\documentclass{article}

\usepackage{amsmath}
\usepackage{amsthm}
\usepackage{amssymb}
\usepackage{amscd}
\usepackage{xspace}
\usepackage{verbatim}

\pagestyle{plain}

\pagestyle{plain}

\setlength{\topmargin}{-0.6in} \setlength{\textwidth}{6.in}

\setlength{\textheight}{1.20\textheight}

\setlength{\oddsidemargin}{-0.in}

\setlength{\evensidemargin}{-0.25in}

\numberwithin{equation}{section}
\newtheorem{theorem}{Theorem}[section]
\newtheorem{corollary}{Corollary}[section]
\newtheorem{lemma}{Lemma}[section]
\newtheorem{proposition}{Proposition}[section]
\theoremstyle{definition}
\newtheorem{definition}{Definition}[section]

\theoremstyle{remark}
\newtheorem{remark}{Remark}[section]


\newcommand{\ov}{\overline}

\renewcommand{\d}{\delta}

\renewcommand{\O}{\Omega}

\renewcommand{\S}{\Sigma}


\renewcommand{\vec}[1]{\mathbf{#1}}
\newcommand{\R}{\mathbb{R}}

\newcommand{\er}{\eqref}

\DeclareMathOperator{\Div}{div}

\renewcommand{\d}{\delta}

\renewcommand{\O}{\Omega}

\renewcommand{\S}{\Sigma}


%
%
%
%
%
%
%
%
%
%
\date{}

\begin{document}
\title{
{\Huge Non-relativistic model of the laws of gravity and
electromagnetism,
invariant under the change of inertial and non-inertial coordinate
systems}\\[3mm] This is a preprint of the following work: Arkady
Poliakovsky, "The Laws of Gravity and Electromagnetism: A
Non-relativistic Model Invariant Under the Change of Inertial and
Non-inertial Coordinate Systems", 2024, Springer Cham, reproduced
with permission of Springer Nature Switzerland AG 2024. The final
authenticated version is available online at:
https://doi.org/10.1007/978-3-031-61407-1 }\maketitle
\begin{center}
\textsc{Arkady Poliakovsky \footnote{E-mails:
poliakov@math.bgu.ac.il, arkady.pol@gmail.com}}
\\[3mm]
Department of Mathematics, Ben Gurion University of the Negev,\\
P.O.B. 653, Be'er Sheva 84105, Israel
\\[2mm]
\end{center}
\begin{abstract}
Under the classical non-relativistic consideration of the space-time
we propose the model of the laws of gravity and Electrodynamics,
invariant under the galilean transformations and moreover, under
every change of non-inertial cartesian coordinate system. Being in
the frames of non-relativistic model of the space-time, we adopt
some general ideas of the General Theory of Relativity, like the
assumption of invariance of the most general physical laws in every
inertial and non-inertial coordinate system and equivalence of
factious forces in non-inertial coordinate systems and the force of
gravity. Moreover, in the frames of our model, we obtain that the
laws of Non-relativistic Quantum Mechanics are also invariant under
the change of inertial or non-inertial cartesian coordinate system.
%
%
%
%
%
%
\end{abstract}

\section{Introduction}
\subsection{A new look to the Newtonian Gravity}
Consider the classical space-time where the change of some inertial
coordinate system $(*)$ to another inertial coordinate system $(**)$
is given (up to equivalence) by the Galilean Transformation:
\begin{equation}\label{noninchgravortbstrjgghguittu1int}
\begin{cases}
\vec x'=\vec x+\vec wt,\\
t'=t,
\end{cases}
\end{equation}
and the change of some non-inertial cartesian coordinate system
$(*)$ to another cartesian coordinate system $(**)$ is of the form:
\begin{equation}\label{noninchgravortbstrjgghguittu2int}
\begin{cases}
\vec x'=A(t)\cdot\vec x+\vec z(t),\\
t'=t,
\end{cases}
\end{equation}
where $A(t)\in SO(3)$ is a rotation, i.e. $A(t)\in \R^{3\times 3}$,
$det\, A(t)>0$ and $A(t)\cdot A^T(t)=I$, where $A^T$ is the
transpose of the matrix $A$ and $I$ is the identity matrix.

 Similarly to the General Theory of Relativity, we assume that
the most general laws of Classical Mechanics should be invariant in
every non-inertial cartesian coordinate system, i.e. they preserve
their form under transformations of the form
\er{noninchgravortbstrjgghguittu2int}. Moreover, again as in the
General Theory of Relativity, we assume that the fictitious forces
in non-inertial coordinate systems and the forces of Newtonian
gravitation have the same nature and represented by some field in
somewhat similar to the Electromagnetic field.

 We begin with some simple observation. Assume that we are away from
essential gravitational masses. Then consider two cartesian
coordinate systems $(*)$ and $(**)$, such that the system $(**)$ is
inertial and the change of coordinate system $(*)$ to coordinate
system $(**)$ is given by \er{noninchgravortbstrjgghguittu2int}.
Then the fictitious-gravitational force in the system $(**)$ is
trivial $\vec F'_0=0$. On the other hand, by
\er{noninchgravortbstrjgghguittu2int} the fictitious-gravitational
force in the system $(*)$, acting on the particle with inertial mass
$m$ and velocity $\vec u$, is given by
\begin{equation}\label{noninchgravortbstrjgghguittu2gjgghhjhghjhjggint}
\vec F_0=m\left(-2A^T(t)\cdot\frac{dA}{dt}(t)\cdot\vec
u-A^T(t)\cdot\frac{d^2 A}{dt^2}(t)\cdot\vec
x-A^T(t)\cdot\frac{d^2\vec z}{dt^2}(t)\right).
\end{equation}
Thus if we define a vector field $\vec v:=\vec v(\vec x,t)$ by
\begin{equation}\label{noninchgravortbstrjgghguittu2gjgjhjhhklkint}
\vec v(\vec x,t):=-A^T(t)\cdot\frac{dA}{dt}(t)\cdot\vec
x-A^T(t)\cdot\frac{d\vec z}{dt}(t),
\end{equation}
then, by straightforward calculations we rewrite
\er{noninchgravortbstrjgghguittu2gjgghhjhghjhjggint} as
\begin{equation}\label{noninchgravortbstrjgghguittu2gjgghhjhghjhjgghgghghghtytytint}
\vec F_0=m\left(\frac{\partial\vec v}{\partial
t}+\frac{1}{2}\nabla_{\vec x}\left(|\vec v|^2\right)\right)+m\vec
u\times \left(-curl_{\vec x}\vec v\right)
\end{equation}
(see section \ref{gugyu} for details).

 Similarly, we assume that also in the general case of
gravitational masses there exists a vector field $\vec v:=\vec
v(\vec x,t)$ such that in some inertial or non-inertial cartesian
coordinate system the fictitious-gravitational force is given by
\er{noninchgravortbstrjgghguittu2gjgghhjhghjhjgghgghghghtytytint}.
Then we call the vector field $\vec v$ the vectorial gravitational
potential. We see here the following analogy with Electrodynamics:
denoting
\begin{equation*}
\tilde {\vec E}:=\partial_{t}\vec v+\nabla_{\vec
x}\left(\frac{1}{2}|\vec v|^2\right)\quad\text{and}\quad \tilde
{\vec B}:=-c\, curl_{\vec x}\vec v,
\end{equation*}
we rewrite
\er{noninchgravortbstrjgghguittu2gjgghhjhghjhjgghgghghghtytytint} as
\begin{equation*}
\vec F_0=m\left(\tilde {\vec E}+\frac{1}{c}\vec u\times\tilde {\vec
B}\right),
\end{equation*}
where
\begin{equation*}
curl_{\vec x}\tilde {\vec E}+\frac{1}{c}\frac{\partial}{\partial
t}\tilde {\vec B}=0\quad\text{and}\quad div_{\vec x}\tilde {\vec
B}=0.
\end{equation*}

Next using
\er{noninchgravortbstrjgghguittu2gjgghhjhghjhjgghgghghghtytytint} we
rewrite the Second Law of Newton as
\begin{equation}\label{noninchgravortbstrjgghguittu2gjgghhjhghjhjgghgghghghtytythvfghfgghjggint}
m\frac{d^2\vec x}{dt^2}=m\frac{d\vec u}{dt}=\vec F_0+\vec
F=m\left(\frac{\partial\vec v}{\partial t}(\vec
x,t)+\frac{1}{2}\nabla_{\vec x}\left(\left|\vec
v\right|^2\right)(\vec x,t)\right)+m\vec u\times \left(-curl_{\vec
x}\vec v(\vec x,t)\right)+\vec F,
\end{equation}
where $\vec x:=\vec x(t)$, $\vec u:=\vec u(t)=\frac{d\vec x}{dt}(t)$
and $m$ are the place, the velocity and the inertial mass of some
given particle at the moment of time $t$, $\vec v:=\vec v(\vec x,t)$
is the vectorial gravitational potential and $\vec F$ is the total
non-gravitational force, acting on the given particle.

Once we considered the Second Law of Newton in the form
\er{noninchgravortbstrjgghguittu2gjgghhjhghjhjgghgghghghtytythvfghfgghjggint}
we show that this law is invariant under the change of inertial or
non-inertial cartesian coordinate system, provided that the law of
transformation of the vectorial gravitational potential, under the
change of coordinate system given by
\er{noninchgravortbstrjgghguittu2int}, is:
\begin{equation}\label{MaxVacFull1ninshtrgravortPPNintspd}\vec v'=A(t)\cdot \vec
v+\frac{dA}{dt}(t)\cdot\vec x+\frac{d\vec z}{dt}(t)
\end{equation}
i.e. it is the same as the transformation of a field of velocities.
More precisely we have the following theorem (see section
\ref{gugyu} for the proof):
\begin{theorem}\label{gjghghgghgint}
Consider that the change of some non-inertial cartesian coordinate
system $(*)$ to another cartesian coordinate system $(**)$ is given
by \er{noninchgravortbstrjgghguittu2int}. Next, assume that in the
coordinate system $(**)$ we observe a validity of the Second Law of
Newton in the form:
\begin{equation}\label{MaxVacFull1ninshtrgravortPPNint}
\frac{d\vec u'}{dt'}=-\vec u'\times curl_{\vec x'}\vec
v'+\partial_{t'}\vec v'+\nabla_{\vec x'}\left(\frac{1}{2}|\vec
v'|^2\right)+\frac{1}{m'}\vec F',
\end{equation}
where $\vec x':=\vec x'(t')$, $\vec u':=\vec u'(t')=\frac{d\vec
x'}{dt'}(t')$ and $m'$ are the place, the velocity and the inertial
mass of some given particle at the moment of time $t'$, $\vec
v':=\vec v'(\vec x',t')$ is the vectorial gravitational potential
and $\vec F'$ is a total non-gravitational force, acting on the
given particle in the coordinate system $(**)$. Then in the
coordinate system $(*)$ we have validity of the Second Law of Newton
in the same as \er{MaxVacFull1ninshtrgravortPPNint} form:
\begin{equation}\label{MaxVacFull1ninshtrgravortjhhjPPNjffjfint}
\frac{d\vec u}{dt}=-\vec u\times curl_{\vec x}\vec
v+\partial_{t}\vec v+\nabla_{\vec x}\left(\frac{1}{2}|\vec
v|^2\right)+\frac{1}{m}\vec F,
\end{equation}
provided that
\begin{align}
\label{NoIn5gravortPPN11int}\vec v'=A(t)\cdot \vec
v+\frac{dA}{dt}(t)\cdot\vec x+\frac{d\vec z}{dt}(t)\\
\label{NoIn1gravortPPN11int}\vec F'=A(t)\cdot\vec F,\\
\label{NoIn2gravortPPN11int}m'=m,\\
\label{NoIn3gravortPPN11int}\vec u'=A(t)\cdot \vec
u+\frac{dA}{dt}(t)\cdot\vec x+\frac{d\vec z}{dt}(t).
\end{align}
\end{theorem}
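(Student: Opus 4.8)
The plan is to prove the invariance by a direct substitution: I would start from the primed equation \er{MaxVacFull1ninshtrgravortPPNint}, insert the transformation laws \er{NoIn5gravortPPN11int}--\er{NoIn3gravortPPN11int}, reduce every primed quantity to unprimed ones, and show that the result collapses exactly to \er{MaxVacFull1ninshtrgravortjhhjPPNjffjfint}. Two structural facts drive the whole computation. First, since $t'=t$, total and partial time derivatives in the two systems coincide up to the chain rule. Second, since $A(t)\in SO(3)$, differentiating $A^T A=I$ gives $A^T\frac{dA}{dt}+\left(\frac{dA}{dt}\right)^T A=0$, so $A^T\frac{dA}{dt}$ is skew-symmetric. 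As a preliminary step I would record the kinematics of the change \er{noninchgravortbstrjgghguittu2int}: inverting it as $\vec x=A^T\cdot(\vec x'-\vec z)$ and differentiating at fixed $\vec x'$ yields $\frac{\partial\vec x}{\partial t'}\big|_{\vec x'}=-A^T\frac{dA}{dt}\vec x-A^T\frac{d\vec z}{dt}$, which is exactly the velocity field of \er{noninchgravortbstrjgghguittu2gjgjhjhhklkint}.

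Next I would compute the left-hand side. Differentiating \er{NoIn3gravortPPN11int} in $t'=t$ and using $\vec u=\frac{d\vec x}{dt}$ gives
\[ \frac{d\vec u'}{dt'}=A\frac{d\vec u}{dt}+2\frac{dA}{dt}\vec u+\frac{d^2A}{dt^2}\vec x+\frac{d^2\vec z}{dt^2}. \]
Everything beyond $A\frac{d\vec u}{dt}$ is a fictitious contribution, and the objective is to show that these extra terms are reproduced term-by-term on the right-hand side.

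Then I would transform each term on the right. For the differential operators I would use the chain rule through $\vec x=A^T\cdot(\vec x'-\vec z)$, giving $\nabla_{\vec x'}=A\nabla_{\vec x}$ on corresponding scalar fields, and the covariance of the curl under $SO(3)$, namely $curl_{\vec x'}(A\vec v)=A\,curl_{\vec x}\vec v$. The linear field $\frac{dA}{dt}\vec x+\frac{d\vec z}{dt}$, written in $\vec x'$ as $\frac{dA}{dt}A^T(\vec x'-\vec z)+\frac{d\vec z}{dt}$, has constant curl equal to $2\vec\Omega$, where $\vec\Omega$ is the angular-velocity (axial) vector of the skew matrix $\frac{dA}{dt}A^T$. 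The partial time derivative $\partial_{t'}\vec v'\big|_{\vec x'}$ must be expanded with the advective correction supplied by $\frac{\partial\vec x}{\partial t'}\big|_{\vec x'}$ computed above, and the gradient term I would handle with the identity $\nabla\left(\tfrac12|\vec w|^2\right)=(\vec w\cdot\nabla)\vec w+\vec w\times curl\,\vec w$ applied to $\vec w=\vec v'$.

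Finally I would assemble all pieces, multiply through by $A^T$ to work in the unprimed frame, and verify the cancellations. The main obstacle is the bookkeeping in this last step: one must show that the Coriolis-type term $-\vec u'\times curl_{\vec x'}\vec v'$, together with the advective and $|\vec v'|^2$-gradient contributions, reproduces precisely the extra terms $2\frac{dA}{dt}\vec u+\frac{d^2A}{dt^2}\vec x+\frac{d^2\vec z}{dt^2}$ coming from the left-hand side, plus the genuine rotated term $A\big(-\vec u\times curl_{\vec x}\vec v+\partial_t\vec v+\nabla_{\vec x}(\tfrac12|\vec v|^2)\big)$. This relies repeatedly on the skew-symmetry of $A^T\frac{dA}{dt}$, on rewriting $\frac{dA}{dt}\vec w=\vec\Omega\times(A\vec w)$, and on the vector triple-product identity to collapse the nested cross products; the algebra is routine once organized so that the second-order quantities $\frac{d^2A}{dt^2}\vec x$ and $\frac{d^2\vec z}{dt^2}$ are seen to cancel. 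Since mass and force transform trivially by \er{NoIn2gravortPPN11int} and \er{NoIn1gravortPPN11int}, the term $\frac{1}{m'}\vec F'=\frac{1}{m}A\vec F$ matches the rotated genuine force directly, completing the reduction to \er{MaxVacFull1ninshtrgravortjhhjPPNjffjfint}.
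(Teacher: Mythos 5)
Your plan is correct and follows essentially the same route as the paper's own proof: direct substitution of the transformation laws into the primed equation, chain rule for the time derivatives, covariance of the curl under rotations, and cancellation of the fictitious terms (including $\frac{d^2A}{dt^2}\cdot\vec x+\frac{d^2\vec z}{dt^2}$) via the skew-symmetry of $A^T\cdot\frac{dA}{dt}$. The paper organizes the algebra by first rewriting both equations in terms of the relative velocity $\vec u-\vec v$, which is exactly what your use of the identity $\nabla_{\vec x}\left(\frac{1}{2}|\vec v|^2\right)=d_{\vec x}\vec v\cdot\vec v+\vec v\times curl_{\vec x}\vec v$ combined with the Coriolis term amounts to, so the two computations coincide step for step.
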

We call a vector field that transforms by
\er{MaxVacFull1ninshtrgravortPPNintspd} under the change of
cartesian coordinate system, by the name speed-like vector field.
Since the vectorial gravitational potential $\vec v$ is a speed-like
vector field, i.e. under the changes of inertial or non-inertial
coordinate system it behaves like a field of velocities of some
continuum, we could introduce a fictitious continuum medium covering
all the space, that we can call Aether, such that $\vec v(\vec x,t)$
is a fictitious velocity of this medium in the point $\vec x$ at the
time $t$. Furthermore, if some particle with the place $\vec r:=\vec
r(t)$, the velocity $\vec u:=\vec u(t)=\frac{d\vec r}{dt}(t)$ and
the inertial mass $m$ moves in the outer gravitational field with
the vectorial gravitational potential $\vec v:=\vec v(\vec x,t)$ in
the absence of non-gravitational forces, then we can associate a
Lagrangian with
\er{noninchgravortbstrjgghguittu2gjgghhjhghjhjgghgghghghtytythvfghfgghjggint}.
Indeed, for this case we define a Lagrangian:
\begin{equation}\label{vhfffngghkjgghfjjSYSPNint}
\mathcal{L}_0\left(\frac{d\vec r}{dt},\vec
r,t\right):=\frac{m}{2}\left|\frac{d\vec r}{dt}-\vec v(\vec
r,t)\right|^2.
\end{equation}
This Lagrangian is invariant under the change of non-inertial
cartesian coordinate systems, given by
\er{noninchgravortbstrjgghguittu2int}. Moreover, we can easily
deduce that a trajectory $\vec r(t):[0,T]\to\mathbb{R}^3$ is a
critical point of the functional
\begin{equation}\label{btfffygtgyggyijhhkkSYSPNint}
I_0=\int_0^T \mathcal{L}_0\left(\frac{d\vec r}{dt}(t),\vec
r(t),t\right)dt
\end{equation}
if and only if it satisfies
\begin{equation}\label{vhfffngghkjgghggtghjgfhjoyuiyuyhiyyukukyihyuSYSPNint}
-m\frac{d^2\vec r}{dt^2}+m\left(\frac{\partial}{\partial t}\vec
v(\vec r,t)+\nabla_{\vec x}\left(\frac{1}{2}\left|\vec v(\vec
r,t)\right|^2\right)-\frac{d\vec r}{dt}\times curl_{\vec x}\vec
v\left(\vec r,t\right)\right)=0,
\end{equation}
consistently with
\er{noninchgravortbstrjgghguittu2gjgghhjhghjhjgghgghghghtytythvfghfgghjggint}
for the case $\vec F=0$. Moreover, we would like to note that if in
some inertial or non-inertial cartesian coordinate system some
material body with the place $\vec r(t)$ and velocity $\vec
u(t)=\frac{d\vec r}{dt}(t)$ moves in the gravitational field, and,
except of the force of gravity all other forces, acting on the body,
are negligible then we can prove that the following equality for
some instant of time $t_0$:
\begin{equation*}
\vec u(t_0):=\frac{d\vec r}{dt}(t_0)=\vec v\left(\vec
r(t_0),t_0\right)
\end{equation*}
implies
\begin{equation*}
\vec u(t):=\frac{d\vec r}{dt}(t)=\vec v\left(\vec r(t),t\right),
\end{equation*}
for every instant of time. I.e. if the velocity of the particle for
some initial instant of time coincides with the local vectorial
gravitational potential, then it will coincide with it at any
instant of time and the trajectory of the motion will be tangent to
the direction of the local vectorial gravitational potential.

Next, in order to fit the Second Law of Newton in the form
\er{noninchgravortbstrjgghguittu2gjgghhjhghjhjgghgghghghtytythvfghfgghjggint}
with the classical Second Law of Newton and the Newtonian Law of
Gravity we consider that in \underline{inertial} coordinate system
$(*)$, at least in the first approximation, we should have
\begin{equation}
\label{MaxVacFull1ninshtrgravortghhghgjkgghklhjgkghghjjkjhjkkggjkhjkhjjhhfhjhklkhkhjjklzzzyyyhjggjhgghhjhNWNWBWHWPPN222kgghjghjghjint}
\begin{cases}
curl_{\vec x}\vec v= 0,\\
\frac{\partial\vec v}{\partial t}+\frac{1}{2}\nabla_{\vec
x}\left(|\vec v|^2\right)= -\nabla_{\vec x}\Phi,
\end{cases}
\end{equation}
where $\Phi$ is a scalar Newtonian gravitational potential which
satisfies
\begin{equation}
\label{MaxVacFull1ninshtrgravortghhghgjkgghklhjgkghghjjkjhjkkggjkhjkhjjhhfhjhklkhkhjjklzzzyyyhjggjhgghhjhNWNWNWBWHWPPN222int}
\Delta_{\vec x}\Phi=4\pi GM,
\end{equation}
where $M$ is the gravitational mass density and $G$ is the
gravitational constant. Thus, since we require $curl_{\vec x}\vec v=
0$,
\er{MaxVacFull1ninshtrgravortghhghgjkgghklhjgkghghjjkjhjkkggjkhjkhjjhhfhjhklkhkhjjklzzzyyyhjggjhgghhjhNWNWBWHWPPN222kgghjghjghjint}
is equivalent to:
\begin{equation}
\label{MaxVacFull1ninshtrgravortghhghgjkgghklhjgkghghjjkjhjkkggjkhjkhjjhhfhjhklkhkhjjklzzzyyyhjggjhgghhjhNWNWBWHWPPN222int}
\begin{cases}
curl_{\vec x}\vec v= 0,\\
\frac{\partial\vec v}{\partial t}+d_\vec x\vec v\cdot\vec v=
-\nabla_{\vec x}\Phi,
\end{cases}
\end{equation}
where $d_{\vec x}\vec v$ is the Jacobian matrix of the vector field
$\vec v$. Clearly the law
\er{MaxVacFull1ninshtrgravortghhghgjkgghklhjgkghghjjkjhjkkggjkhjkhjjhhfhjhklkhkhjjklzzzyyyhjggjhgghhjhNWNWBWHWPPN222int}
is invariant under the change of inertial coordinate system, given
by \er{noninchgravortbstrjgghguittu1int}. Note also that, since in
the system $(*)$ we have $curl_{\vec x}\vec v=0$, we can write
\er{MaxVacFull1ninshtrgravortghhghgjkgghklhjgkghghjjkjhjkkggjkhjkhjjhhfhjhklkhkhjjklzzzyyyhjggjhgghhjhNWNWBWHWPPN222kgghjghjghjint}
as a Hamilton-Jacobi type equation:
\begin{equation}
\label{MaxVacFull1ninshtrgravortghhghgjkgghklhjgkghghjjkjhjkkggjkhjkhjjhhfhjhklkhkhjjklzzzyyyhjggjhgghhjhNWNWNWNWNWBWHWPPN222int}
\begin{cases}
\vec v=\nabla_{\vec x}Z,\\
\frac{\partial Z}{\partial t}+\frac{1}{2}\left|\nabla_{\vec
x}Z\right|^2=-\Phi,
\end{cases}
\end{equation}
where $Z$ is some scalar field. Next we introduce a law of gravity
which is invariant in every non-inertial cartesian coordinate system
and is equivalent to
\er{MaxVacFull1ninshtrgravortghhghgjkgghklhjgkghghjjkjhjkkggjkhjkhjjhhfhjhklkhkhjjklzzzyyyhjggjhgghhjhNWNWBWHWPPN222int}
in every inertial coordinate system. This law has the form:
\begin{equation}
\label{MaxVacFull1ninshtrgravortghhghgjkgghklhjgkghghjjkjhjkkggjkhjkhjjhhfhjhklkhkhjjklzzzyyyNWBWHWPPN222int}
\begin{cases}
curl_{\vec x}\left(curl_{\vec x}\vec v\right)= 0,\\
\frac{\partial}{\partial t}\left(div_{\vec x}\vec v\right)+div_{\vec
x}\left\{\left(div_{\vec x}\vec v\right)\vec
v\right\}+\frac{1}{4}\left|d_{\vec x}\vec v+\{d_{\vec x}\vec
v\}^T\right|^2-\left(div_{\vec x}\vec v\right)^2= -4\pi GM,
\end{cases}
\end{equation}
%
%
%
%
%
%
together with some reasonable frame-independent assumptions for the
asymptotic behavior of the vectorial gravitational potential $\vec
v$ as $|\vec x|\to+\infty$ (see section \ref{gugyu} for the details,
in particular, see Proposition \ref{jghjgyfyffffhfgljjkjl}).

Next one can wonder: what should be possible values of the vectorial
gravitational potential $\vec v$ in the proximity of the Earth or
another massive body? We attempt to answer this question in remark
\ref{gygygygyggjh}. We obtain there that, if we consider a
\underline{non-rotating} cartesian coordinate system which center
coincides with the center of the Earth, then in this system we
should have either
\begin{equation}
\label{MaxVacFull1ninshtrgravortghhghgjkgghklhjgkghghjjkjhjkkggjkhjkhjjhhfhjhklkhkhjjklzzzyyyhjggjhgghhjhNWNWBWHWPPN222kkkgtytghjjhpoppkkkhhhkhjhjint}
\vec v(\vec x)= \frac{\sqrt{-2\Phi_1(|\vec x|)}}{|\vec x|}\vec x,
\end{equation}
or
\begin{equation}
\label{MaxVacFull1ninshtrgravortghhghgjkgghklhjgkghghjjkjhjkkggjkhjkhjjhhfhjhklkhkhjjklzzzyyyhjggjhgghhjhNWNWBWHWPPN222kkkgtytghjjhpoppkkkhhhkhjhjiuuint}
\vec v(\vec x)= -\frac{\sqrt{-2\Phi_1(|\vec x|)}}{|\vec x|}\vec x,
\end{equation}
where $\Phi_1$ is the usual Newtonian potential of the Earth, that
satisfies $\Phi_1(r)=-\frac{Gm_0}{r}$ outside of the Earth. In
particular, on the Earth surface we have:
\begin{equation}
\label{MaxVacFull1ninshtrgravortghhghgjkgghklhjgkghghjjkjhjkkggjkhjkhjjhhfhjhklkhkhjjklzzzyyyhjggjhgghhjhNWNWBWHWPPN222kkkgtytghjjhpoppkkkhhhkhjhjiuuokokokint}
|\vec v|=\sqrt{\frac{2Gm_0}{r_0}},
\end{equation}
where $r_0$ is the Earth radius and $m_0$ is the Earth mass, i.e.
the absolute value of the vectorial gravitational potential on the
Earth surface approximately equals to the escape velocity and its
direction is normal to the Earth, either downward or upward.

\subsubsection{Variational principle for the solution of the
Cauchy-problem for Hamilton-Jacobi type equation
\er{MaxVacFull1ninshtrgravortghhghgjkgghklhjgkghghjjkjhjkkggjkhjkhjjhhfhjhklkhkhjjklzzzyyyhjggjhgghhjhNWNWNWNWNWBWHWPPN222int}.}
Consider the Hamilton-Jacobi type equation
\er{MaxVacFull1ninshtrgravortghhghgjkgghklhjgkghghjjkjhjkkggjkhjkhjjhhfhjhklkhkhjjklzzzyyyhjggjhgghhjhNWNWNWNWNWBWHWPPN222int}
for the vectorial gravitational potential in the non-rotating
coordinate system:
\begin{equation}
\label{MaxVacFull1ninshtrgravortghhghgjkgghklhjgkghghjjkjhjkkggjkhjkhjjhhfhjhklkhkhjjklzzzyyyhjggjhgghhjhNWNWNWNWNWBWHWPPN222RRint}
\begin{cases}
\vec v=\nabla_{\vec x}Z,\\
\frac{\partial Z}{\partial t}+\frac{1}{2}\left|\nabla_{\vec
x}Z\right|^2=-\Phi,
\end{cases}
\end{equation}
where $Z:=Z(\vec x,t)$ is some scalar field and $\Phi=\Phi(\vec
x,t)$ is the scalar Newtonian gravitational potential which
satisfies
\er{MaxVacFull1ninshtrgravortghhghgjkgghklhjgkghghjjkjhjkkggjkhjkhjjhhfhjhklkhkhjjklzzzyyyhjggjhgghhjhNWNWNWBWHWPPN222int}:
\begin{equation}
\label{MaxVacFull1ninshtrgravortghhghgjkgghklhjgkghghjjkjhjkkggjkhjkhjjhhfhjhklkhkhjjklzzzyyyhjggjhgghhjhNWNWNWBWHWPPN222RRint}
\Delta_{\vec x}\Phi=4\pi GM,
\end{equation}
where $M$ is the gravitational mass density and $G$ is the
gravitational constant. Next consider the Cauchy problem for
\er{MaxVacFull1ninshtrgravortghhghgjkgghklhjgkghghjjkjhjkkggjkhjkhjjhhfhjhklkhkhjjklzzzyyyhjggjhgghhjhNWNWNWNWNWBWHWPPN222RRint}:
\begin{equation}
\label{MaxVacFull1ninshtrgravortghhghgjkgghklhjgkghghjjkjhjkkggjkhjkhjjhhfhjhklkhkhjjklzzzyyyhjggjhgghhjhNWNWNWNWNWBWHWPPN222RRkmmjljlmint}
\begin{cases}
\frac{\partial Z}{\partial t}(\vec
x,t)+\frac{1}{2}\left|\nabla_{\vec x}Z(\vec x,t)\right|^2=-\Phi(\vec
x,t),\\
Z(\vec x,0)=\varphi(\vec x),
\end{cases}
\end{equation}
where $\Phi(\vec x,t)$ and $\varphi(\vec x)$ are prescribed smooth
scalar functions.
Furthermore, for
every instant of time $t\geq 0$ consider the followed variational
functional defined on trajectories $\vec r(s):[0,t]\to\R^3$:
\begin{equation}\label{gjgjfhhhhhhhhhhhhhhhhhjjjjkint}
J_{gr}\left(\vec r\right)=\varphi\left(\vec
r(0)\right)+\int_0^t\left(\frac{1}{2}\left|\frac{d\vec
r}{ds}(s)\right|^2-\Phi\left(\vec r(s),s\right)\right)ds.
\end{equation}
Then, inserting the smooth solution $Z$ of
\er{MaxVacFull1ninshtrgravortghhghgjkgghklhjgkghghjjkjhjkkggjkhjkhjjhhfhjhklkhkhjjklzzzyyyhjggjhgghhjhNWNWNWNWNWBWHWPPN222RRkmmjljlmint}
into \er{gjgjfhhhhhhhhhhhhhhhhhjjjjkint} and using the Chain Rule,
in subsection \ref{GGGGGZZZZffffhhh} we deduce that we can rewrite
the definition of the functional $J_{gr}$ in
\er{gjgjfhhhhhhhhhhhhhhhhhjjjjkint} as:
\begin{multline}\label{gjgjfhhhhhhhhhhhhhhhhhjjjjklmjkkhkjjjjmjljint}
J_{gr}\left(\vec r\right)=Z\left(\vec
r(t),t\right)+\frac{1}{2}\int_0^t\left|\frac{d\vec
r}{ds}(s)-\nabla_{\vec {\vec r}}Z\left(\vec
r(s),s\right)\right|^2ds=Z\left(\vec
r(t),t\right)+\frac{1}{2}\int_0^t\left|\frac{d\vec r}{ds}(s)-\vec
v\left(\vec r(s),s\right)\right|^2ds,
\end{multline}
where consistently with
\er{MaxVacFull1ninshtrgravortghhghgjkgghklhjgkghghjjkjhjkkggjkhjkhjjhhfhjhklkhkhjjklzzzyyyhjggjhgghhjhNWNWNWNWNWBWHWPPN222RRint}
we consider $\vec v(\vec x,t):=\nabla_{\vec x}Z(\vec x,t)$.
Therefore, if for every point $\vec x\in\R^3$ and every instant of
time $t\geq 0$ we consider the function $g:=g(\vec
x,t):\R^3:[0,+\infty)\to\R$ defined as a minimum of the following
variational problem:
\begin{multline}\label{gjgjfhhhhhhhhhhhhhhhhhjjjjkhhhjhjjhhint}
g(\vec x,t):=\min{\left\{J_{gr}\left(\vec r\right)\;\mid\;\;\vec
r(s):[0,t]\to\R^3,\;r(t)=\vec x\right\}}=\\
\min{\left\{\varphi\left(\vec
r(0)\right)+\int_0^t\left(\frac{1}{2}\left|\frac{d\vec
r}{ds}(s)\right|^2-\Phi\left(\vec
r(s),s\right)\right)ds\;\mid\;\;\vec r(s):[0,t]\to\R^3,\;\vec
r(t)=\vec x\right\}},
\end{multline}
then by \er{gjgjfhhhhhhhhhhhhhhhhhjjjjklmjkkhkjjjjmjljint} we deduce
that
\begin{equation}\label{gjgjfhhhhhhhhhhhhhhhhhjjjjkhhhjhjjhhjkjhhint}
g(\vec x,t)=Z(\vec x,t),
\end{equation}
and this minimum is achieved on the unique trajectory $\vec
r(s):[0,t]\to\R^3$ that satisfies the following initial value
problem for an ordinary differential equation:
\begin{equation}
\label{MaxVacFull1ninshtrgravortghhghgjkgghklhjgkghghjjkjhjkkggjkhjkhjjhhfhjhklkhkhjjklzzzyyyhjggjhgghhjhNWNWNWNWNWBWHWPPN222RRkkljljkhhjhint}
\begin{cases}
\frac{d\vec r}{ds}(s)=\nabla_{\vec {\vec r}}Z\left(\vec
r(s),s\right)=\vec v\left(\vec r(s),s\right)\quad\quad\forall
s\in[0,t],\\
\vec r(t)=\vec x.
\end{cases}
\end{equation}
So by inserting \er{gjgjfhhhhhhhhhhhhhhhhhjjjjkhhhjhjjhhjkjhhint}
into \er{gjgjfhhhhhhhhhhhhhhhhhjjjjkhhhjhjjhhint} we obtain the
\underline{explicit} formula for the solution $Z$ of the Chuchy
problem
\er{MaxVacFull1ninshtrgravortghhghgjkgghklhjgkghghjjkjhjkkggjkhjkhjjhhfhjhklkhkhjjklzzzyyyhjggjhgghhjhNWNWNWNWNWBWHWPPN222RRkmmjljlmint}:
\begin{multline}\label{gjgjfhhhhhhhhhhhhhhhhhjjjjkhhhjhjjhhgggint}
Z(\vec x,t)=\min{\left\{\varphi\left(\vec
r(0)\right)+\int_0^t\left(\frac{1}{2}\left|\frac{d\vec
r}{ds}(s)\right|^2-\Phi\left(\vec
r(s),s\right)\right)ds\;\mid\;\;\vec r(s):[0,t]\to\R^3,\;\vec
r(t)=\vec x\right\}}.
\end{multline}
Furthermore, there exists a unique trajectory $\vec
r(s):[0,t]\to\R^3$ that minimizes the right-hand-side of
\er{gjgjfhhhhhhhhhhhhhhhhhjjjjkhhhjhjjhhgggint} and moreover it
satisfies
\er{MaxVacFull1ninshtrgravortghhghgjkgghklhjgkghghjjkjhjkkggjkhjkhjjhhfhjhklkhkhjjklzzzyyyhjggjhgghhjhNWNWNWNWNWBWHWPPN222RRkkljljkhhjhint}.
In particular for this trajectory we have
\begin{equation}\label{MaxVacFull1ninshtrgravortghhghgjkgghklhjgkghghjjkjhjkkggjkhjkhjjhhfhjhklkhkhjjklzzzyyyhjggjhgghhjhNWNWNWNWNWBWHWPPN222RRkkljljkhhjhjhhhjhkjjgint}
\vec v\left(\vec x,t\right)=\frac{d\vec r}{ds}(t).
\end{equation}
On the other hand the minimizer of the right-hand-side of
\er{gjgjfhhhhhhhhhhhhhhhhhjjjjkhhhjhjjhhgggint} clearly satisfies
the following Euler-Lagrange equation:
\begin{equation}
\label{MaxVacFull1ninshtrgravortghhghgjkgghklhjgkghghjjkjhjkkggjkhjkhjjhhfhjhklkhkhjjklzzzyyyhjggjhgghhjhNWNWNWNWNWBWHWPPN222RRkkljljkhhjhjhhhjhjkhint}
\begin{cases}
\frac{d^2\vec r}{ds^2}(s)=-\nabla_{\vec {\vec r}}\Phi\left(\vec
r(s),s\right)\quad\quad\forall
s\in[0,t],\\
\frac{d\vec r}{ds}(0)=\nabla_{\vec r}\varphi\left(\vec r(0)\right),
\\
\vec r(t)=\vec x.
\end{cases}
\end{equation}
Equation
\er{MaxVacFull1ninshtrgravortghhghgjkgghklhjgkghghjjkjhjkkggjkhjkhjjhhfhjhklkhkhjjklzzzyyyhjggjhgghhjhNWNWNWNWNWBWHWPPN222RRkkljljkhhjhjhhhjhjkhint}
sometimes has an advantage with respect to
\er{MaxVacFull1ninshtrgravortghhghgjkgghklhjgkghghjjkjhjkkggjkhjkhjjhhfhjhklkhkhjjklzzzyyyhjggjhgghhjhNWNWNWNWNWBWHWPPN222RRkkljljkhhjhint}
since the a priori unknown function $Z$ dose not appear in
\er{MaxVacFull1ninshtrgravortghhghgjkgghklhjgkghghjjkjhjkkggjkhjkhjjhhfhjhklkhkhjjklzzzyyyhjggjhgghhjhNWNWNWNWNWBWHWPPN222RRkkljljkhhjhjhhhjhjkhint}.

So, in order to find $\vec v(\vec x,t)$ we need first to solve
\er{MaxVacFull1ninshtrgravortghhghgjkgghklhjgkghghjjkjhjkkggjkhjkhjjhhfhjhklkhkhjjklzzzyyyhjggjhgghhjhNWNWNWNWNWBWHWPPN222RRkkljljkhhjhjhhhjhjkhint}
and then use
\er{MaxVacFull1ninshtrgravortghhghgjkgghklhjgkghghjjkjhjkkggjkhjkhjjhhfhjhklkhkhjjklzzzyyyhjggjhgghhjhNWNWNWNWNWBWHWPPN222RRkkljljkhhjhjhhhjhkjjgint}
with the solution of
\er{MaxVacFull1ninshtrgravortghhghgjkgghklhjgkghghjjkjhjkkggjkhjkhjjhhfhjhklkhkhjjklzzzyyyhjggjhgghhjhNWNWNWNWNWBWHWPPN222RRkkljljkhhjhjhhhjhjkhint}
that we just found.

\subsection{Non-relativistic model of Electrodynamics}
Similarly to the General Theory of Relativity we assume that the
electromagnetic field is influenced by the gravitational field. In
Section \ref{MaxRevsPPN} of this paper we propose the simple and
natural quantitative relations of Electrodynamics, substituting
(with minor changes) the classical Maxwell equations in the case of
an arbitrarily vectorial gravitational potential, and invariant
under Galilean Transformations. For this propose we appeal to the
Maxwell equations in a medium. It is well known that
the classical Maxwell equations in a medium have the following form
in the Gaussian unit system:
\begin{equation}
\label{MaxVackkkint}
\begin{cases}
curl_{\vec x} \vec H= \frac{4\pi}{c}\vec j+\frac{1}{c}\frac{\partial \vec D}{\partial t},\\
div_{\vec x} \vec D= 4\pi\rho,\\
curl_{\vec x} \vec E+\frac{1}{c}\frac{\partial \vec B}{\partial t}= 0,\\
div_{\vec x} \vec B= 0.
\end{cases}
\end{equation}
Here $\vec x\in\R^3$ and $t>0$ are the place and the time, $\vec E$
is the electric field, $\vec B$ is the magnetic field, $\vec D$ is
the electric displacement field, $\vec H$ is the $\vec H$-magnetic
field, $\rho$ is the charge density, $\vec j$ is the current density
and $c$ is the universal constant,
called
speed of light.
It is assumed in the Classical Electrodynamics that for the vacuum
we always have $\vec D= \vec E$ and $\vec H=\vec B$. We assume here
that the Maxwell equations in the vacuum have the usual form of
\er{MaxVackkkint} in every inertial coordinate system, as in any
other medium, however, we assume that, given some inertial
coordinate system, the relations $\vec D= \vec E$ and $\vec H=\vec
B$ in the vacuum are valid only for the parts of the space, where
the vectorial gravitational potential is negligible.

So we assume that, given some inertial coordinate system, if in some
point and at some instant the vectorial gravitational potential
vanishes, then in this point and at this time we have $\vec D= \vec
E$ and $\vec H=\vec B$. In order to obtain the relations $\vec D\sim
\vec E$ and $\vec H\sim \vec B$ in the general case we assume that
the equations \er{MaxVackkkint} and the Lorentz force
\begin{equation}
\label{LorenzChlljklljklkkzzzint} \vec F=\sigma \vec
E+\frac{\sigma}{c}\,\vec u\times \vec B
\end{equation}
(where $\sigma$ is the charge of the test particle and $\vec u$ is
its velocity) are invariant under the Galilean transformations,
given by \er{noninchgravortbstrjgghguittu1int}. Then the analysis of
our assumptions, presented in section \ref{MaxRevsPPN}, implies that
the full system of Electrodynamics in the case of an arbitrarily
vectorial gravitational potential $\vec v:=\vec v(\vec x,t)$ has the
following form:
\begin{equation}\label{MaxVacFull1bjkgjhjhgjaaaint}
\begin{cases}
curl_{\vec x} \vec H= \frac{4\pi}{c}\vec j+\frac{1}{c}\frac{\partial
\vec D}{\partial t},\\
div_{\vec x} \vec D= 4\pi\rho,\\
curl_{\vec x} \vec E+\frac{1}{c}\frac{\partial \vec B}{\partial t}=0,\\
div_{\vec x} \vec B=0,\\
\vec E=\vec D-\frac{1}{c}\,\vec v\times \vec B,\\
\vec H=\vec B+\frac{1}{c}\,\vec v\times \vec D.
\end{cases}
\end{equation}
It can be easily checked that system
\er{MaxVacFull1bjkgjhjhgjaaaint} and the expression of the Lorentz
force in \er{LorenzChlljklljklkkzzzint}
are invariant under the Galilean transformations
\er{noninchgravortbstrjgghguittu1int}, provided that
\begin{equation}\label{EBDHTrans2kkkint}
\begin{cases}
\vec D'=\vec D,\\
\vec B'=\vec B,\\
\vec E'=\vec E-\frac{1}{c}\,\vec w\times \vec B,\\
\vec H'=\vec H+\frac{1}{c}\,\vec w\times \vec D\\
\vec v'=\vec v+\vec w.
\end{cases}
\end{equation}
In section \ref{INNONredPPN} we prove that the laws of
Electrodynamics in the form \er{MaxVacFull1bjkgjhjhgjaaaint} and the
law of the Lorentz force \er{LorenzChlljklljklkkzzzint}, preserve
their form also in non-inertial cartesian coordinate systems. More
precisely the following theorem is valid:
\begin{theorem}\label{gjghghgghgintint}
Consider that the change of some non-inertial cartesian coordinate
system $(*)$ to another cartesian coordinate system $(**)$ is given
by \er{noninchgravortbstrjgghguittu2int}. Next, assume that in the
coordinate system $(**)$ we observe a validity of Maxwell Equations
for the vacuum in the form:
\begin{equation}\label{MaxVacFull1ninshtrredPPNint}
\begin{cases}
curl_{\vec x'} \vec H'= \frac{4\pi}{c}\vec
j'+\frac{1}{c}\frac{\partial
\vec D'}{\partial t'},\\
div_{\vec x'} \vec D'= 4\pi\rho',\\
curl_{\vec x'} \vec E'+\frac{1}{c}\frac{\partial \vec B'}{\partial t'}= 0,\\
div_{\vec x'} \vec B'= 0,\\
\vec E'=\vec D'-\frac{1}{c}\,\vec v'\times \vec B',\\
\vec H'=\vec B'+\frac{1}{c}\,\vec v'\times \vec D'.
\end{cases}
\end{equation}
Moreover, we assume that in coordinate system $(**)$ we observe a
validity of the expression for the Lorentz force in the form:
\begin{equation}\label{LorenzChredPPNint}
\vec F'=\sigma' \vec E'+\frac{\sigma'}{c}\,\vec u'\times \vec B'.
\end{equation}
Then in the coordinate system $(*)$ we have the validity of Maxwell
Equations for the vacuum in the same as
\er{MaxVacFull1ninshtrredPPNint} form:
\begin{equation}\label{MaxVacFull1ninshtrhjkkredPPNint}
\begin{cases}
curl_{\vec x} \vec H= \frac{4\pi}{c}\vec j+\frac{1}{c}\frac{\partial
\vec D}{\partial t},\\
div_{\vec x} \vec D= 4\pi\rho,\\
curl_{\vec x} \vec E+\frac{1}{c}\frac{\partial \vec B}{\partial t}= 0,\\
div_{\vec x} \vec B= 0,\\
\vec E=\vec D-\frac{1}{c}\,\vec v\times \vec B,\\
\vec H=\vec B+\frac{1}{c}\,\vec v\times \vec D,
\end{cases}
\end{equation}
and we have the validity of the expression for the Lorentz force in
the same as \er{LorenzChredPPNint} form:
\begin{equation}\label{LorenzChlljklljkint}
\vec F=\sigma \vec E+\frac{\sigma}{c}\,\vec u\times \vec B,
\end{equation}
provided that
\begin{equation}\label{yuythfgfyftydtydtydtyddyyyhhddhhhredPPN111hgghjgintint}
\begin{cases}
\vec F'=A(t)\cdot\vec F,\\
\sigma'=\sigma,\\
\vec u'=A(t)\cdot \vec u+\frac{dA}{dt}(t)\cdot\vec x+\frac{d\vec z}{dt}(t),\\
\rho'=\rho,\\
\vec v'=A(t)\cdot \vec v+\frac{dA}{dt}(t)\cdot\vec x+\frac{d\vec z}{dt}(t),\\
\vec j'=A(t)\cdot \vec j+\rho\, \frac{dA}{dt}(t)\cdot\vec
x+\rho\,\frac{d\vec z}{dt}(t)
\end{cases}
\end{equation}
and
\begin{equation}\label{yuythfgfyftydtydtydtyddyyyhhddhhhredPPN111hgghjgint}
\begin{cases}
\vec D'=A(t)\cdot \vec D,\\
\vec B'=A(t)\cdot\vec B,\\
\vec E'=A(t)\cdot\vec E-\frac{1}{c}\,\left(\frac{dA}{dt}(t)\cdot\vec
x+\frac{d\vec z}{dt}(t)\right)\times \left(A(t)\cdot\vec B\right),\\
\vec H'=A(t)\cdot\vec H+\frac{1}{c}\,\left(\frac{dA}{dt}(t)\cdot\vec
x+\frac{d\vec z}{dt}(t)\right)\times \left(A(t)\cdot\vec D\right).
\end{cases}
\end{equation}
\end{theorem}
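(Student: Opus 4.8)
The plan is to reduce the whole statement to chain-rule computations for the differential operators under \er{noninchgravortbstrjgghguittu2int}, followed by direct substitution of the prescribed transformation laws \er{yuythfgfyftydtydtydtyddyyyhhddhhhredPPN111hgghjgintint}--\er{yuythfgfyftydtydtydtyddyyyhhddhhhredPPN111hgghjgint}. Because $A(t)\in SO(3)$ depends only on $t$, at every fixed time the spatial operators transform by the rotation alone: with $\vec x=A^T(t)\cdot(\vec x'-\vec z(t))$ one gets $\nabla_{\vec x'}\f'=A(t)\cdot\nabla_{\vec x}\f$ for scalars, and for any vector field obeying $\vec W'=A(t)\cdot\vec W$ one gets $div_{\vec x'}\vec W'=div_{\vec x}\vec W$ and $curl_{\vec x'}\vec W'=A(t)\cdot curl_{\vec x}\vec W$ (the last from the rotation identity $\e_{ijk}A_{jm}A_{kl}=A_{ip}\e_{pml}$). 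The only genuinely time-dependent effect is the derivative at fixed $\vec x'$:
\begin{equation*}
\partial_{t'}\big|_{\vec x'}=\partial_{t}\big|_{\vec x}+\vec q\cdot\nabla_{\vec x},\qquad \vec q:=-A^T(t)\cdot\frac{dA}{dt}(t)\cdot\vec x-A^T(t)\cdot\frac{d\vec z}{dt}(t),
\end{equation*}
the field $\vec q$ coinciding with the fictitious velocity field of \er{noninchgravortbstrjgghguittu2gjgjhjhhklkint}.

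Next I would record two facts about $\vec q$. Since $A^TA=I$, the matrix $A^T\cdot\frac{dA}{dt}$ is skew-symmetric, so $div_{\vec x}\vec q=0$, while $d_{\vec x}\vec q=-A^T\cdot\frac{dA}{dt}$. Using $A\cdot\vec q=-\big(\frac{dA}{dt}\cdot\vec x+\frac{d\vec z}{dt}\big)$ and the fact that $A\in SO(3)$ preserves cross products, the transformation laws collapse to $\vec D'=A\cdot\vec D$, $\vec B'=A\cdot\vec B$, $\vec E'=A\cdot\big(\vec E+\tfrac1c\vec q\times\vec B\big)$, $\vec H'=A\cdot\big(\vec H-\tfrac1c\vec q\times\vec D\big)$, together with $\vec v'=A\cdot(\vec v-\vec q)$, $\vec u'=A\cdot(\vec u-\vec q)$, $\vec j'=A\cdot(\vec j-\rho\,\vec q)$, $\rho'=\rho$ and $\sigma'=\sigma$. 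This is the form in which every subsequent cancellation becomes transparent.

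The two divergence equations are immediate, since $div_{\vec x'}\vec B'=div_{\vec x}\vec B$ and $div_{\vec x'}\vec D'=div_{\vec x}\vec D$, so they are equivalent to $div_{\vec x}\vec B=0$ and $div_{\vec x}\vec D=4\pi\rho$; I would dispatch these first, because they feed the curl equations. The two constitutive relations and the Lorentz force are then purely algebraic: after substituting the collapsed laws and factoring out $A$, the increments proportional to $\vec q$ carried by $\vec E'$, $\vec H'$ and by $\vec v',\vec u'$ combine to reproduce exactly $\vec E=\vec D-\tfrac1c\vec v\times\vec B$, $\vec H=\vec B+\tfrac1c\vec v\times\vec D$ and \er{LorenzChlljklljkint}.

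The heart of the proof, and the main obstacle, is the pair of curl equations, where the transport term $\vec q\cdot\nabla_{\vec x}$ coming from $\partial_{t'}|_{\vec x'}$ must cancel exactly. For Faraday I substitute $\vec E'=A\cdot(\vec E+\tfrac1c\vec q\times\vec B)$ and $\partial_{t'}\vec B'=\frac{dA}{dt}\cdot\vec B+A\cdot\big(\partial_t\vec B+(\vec q\cdot\nabla)\vec B\big)$; multiplying by $A^T$ the primed law becomes the unprimed law plus the residual $curl_{\vec x}(\vec q\times\vec B)+A^T\frac{dA}{dt}\cdot\vec B+(\vec q\cdot\nabla)\vec B$, which vanishes upon expanding $curl_{\vec x}(\vec q\times\vec B)$ and invoking $div_{\vec x}\vec B=0$, $div_{\vec x}\vec q=0$ and $(\vec B\cdot\nabla)\vec q=-A^T\frac{dA}{dt}\cdot\vec B$. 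The Amp\`ere--Maxwell law is identical in structure with $\vec D$ in place of $\vec B$, except that the increment $-\tfrac{4\pi}{c}\rho\,\vec q$ from $\vec j'$ is cancelled by the term $\vec q\,(div_{\vec x}\vec D)=4\pi\rho\,\vec q$ arising in the expansion of $curl_{\vec x}(\vec q\times\vec D)$ --- which is precisely why the Gauss law is needed at this step. The difficulty is not conceptual but lies in this exact matching: it forces $\vec v$, $\vec u$ and $\vec j$ to carry the same velocity-type increment, forces $A^T\frac{dA}{dt}$ to be skew, and requires the divergence equations to be in force before the curl equations are treated.
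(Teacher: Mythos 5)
Your proposal is correct: every cancellation you describe goes through (I checked the Faraday and Amp\`ere residuals, the algebraic collapse of the constitutive relations and of the Lorentz force, and the ordering in which you feed the Gauss laws into the curl equations). However, your organization differs from the paper's in two ways worth noting. First, the paper does not take the field transformation laws \er{yuythfgfyftydtydtydtyddyyyhhddhhhredPPN111hgghjgint} as hypotheses and verify the unprimed system; it \emph{derives} them, obtaining the $\vec E,\vec B$ laws from the assumed invariance of the Lorentz force and then defining $\vec D,\vec H$ in the system $(*)$ by the constitutive relations, so that those relations hold by construction and only the four PDEs remain to be checked. Second, and more substantively, the paper never isolates your frame-relative velocity $\vec q$: it uses the primed constitutive relations to rewrite each primed curl equation in terms of the proper fields $\vec D',\vec B'$ and the combination $\partial_{t'}\vec f'-curl_{\vec x'}(\vec v'\times\vec f')+(div_{\vec x'}\vec f')\,\vec v'$ built from the speed-like gravitational potential $\vec v'$, and then invokes item {\bf(v)} of Proposition \ref{yghgjtgyrtrtint} (proved once, in Section \ref{fhfgfghdfdfdfd}), which asserts that this combination transforms by $A(t)$. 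Your direct computation with $\vec q$ --- expansion of $curl_{\vec x}(\vec q\times\cdot)$ via \er{apfrm6}, skewness of $A^T\frac{dA}{dt}$, and insertion of the Gauss laws --- is exactly the computation hidden inside the paper's proof of that proposition, so the computational core is the same; what you lose is reusability (the paper applies the same proposition again for the gravitational law, the Schr\"odinger equation, and the medium equations), and what you gain is a self-contained argument that makes visible precisely where each hypothesis enters: the Gauss law before Amp\`ere, the skewness of $A^T\frac{dA}{dt}$, and the requirement that $\vec v$, $\vec u$ and $\vec j/\rho$ all carry the same velocity-type increment.
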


Next we would like to note that, since as already mentioned before,
the direction of the local vectorial gravitational potential is
normal to the Earth surface, in the frames of our model, we provide
a non-relativistic explanation of the classical Michelson-Morley
experiment. Indeed in this experiment the axes of the apparatus are
tangent to the Earth surface and thus the null result cannot be
affected by the vectorial gravitational potential. Since, the value
of the local vectorial gravitational potential equals to the escape
velocity, if we consider the vertical Michelson-Morley experiment,
where one of the axes of the apparatus is normal to the Earth
surface, then in the frames of our model the expected result should
be analogous to the positive result of Aether drift with the speed
equal to the escape velocity. However, regarding the vertical
Michelson-Morley experiment i.e. the modification of
Michelson-Morley experiment, where at least one of the axes of the
apparatus is not tangent to the Earth surface, we found only very
scarce and contradictory information.

Next, as in the classical electrodynamics, by the third and the
fourth equations in \er{MaxVacFull1bjkgjhjhgjaaaint} we can find a
scalar field $\Psi:=\Psi(\vec x,t)$ and a vector field $\vec A:=\vec
A(\vec x,t)$ such that
\begin{equation}\label{MaxVacFull1bjkgjhjhgjgjgkjfhjfdghghligioiuittrPPNint}
\begin{cases}
\vec B\equiv curl_{\vec x} \vec A,\\
\vec E\equiv-\nabla_{\vec x}\Psi-\frac{1}{c}\frac{\partial\vec
A}{\partial t}.
\end{cases}
\end{equation}
We call $\Psi$ and $\vec A$ the scalar and the vectorial
electromagnetic potentials. Then by
\er{MaxVacFull1bjkgjhjhgjgjgkjfhjfdghghligioiuittrPPNint} and
\er{MaxVacFull1bjkgjhjhgjaaaint} we also have
\begin{equation}\label{vhfffngghPPN333yuyuint}
\begin{cases}
\vec D=-\nabla_{\vec x}\Psi-\frac{1}{c}\frac{\partial\vec
A}{\partial t}+\frac{1}{c}\vec
v\times curl_{\vec x}\vec A\\
\vec H\equiv curl_{\vec x} \vec A+\frac{1}{c}\,\vec
v\times\left(-\nabla_{\vec x}\Psi-\frac{1}{c}\frac{\partial\vec
A}{\partial t}+\frac{1}{c}\vec v\times curl_{\vec x}\vec A\right).
\end{cases}
\end{equation}
We also define the proper scalar electromagnetic potential
$\Psi_0:=\Psi_0(\vec x,t)$ by
\begin{equation}\label{vhfffngghhjghhgPPNghghghutghffugghjhjkjjklint}
\Psi_0:=\Psi-\frac{1}{c}\vec A\cdot\vec v.
\end{equation}
The name "proper scalar potential" is clarified below. The
electromagnetic potentials are not uniquely defined and thus we need
to choose a calibration. It is clear that if
$(\tilde\Psi,\tilde\Psi_0,\tilde{\vec A})$ is another choice of
electromagnetic potentials with a different calibration then there
exists a scalar field $w:=w(\vec x,t)$ such that we have
\begin{equation}\label{MaxVacFull1bjkgjhjhgjgjgkjfhjfdghghligioiuittrPPNhjkjhkjgghhjjhjint}
\begin{cases}
\tilde\Psi=\Psi+\frac{1}{c}\frac{\partial w}{\partial t}\\
\tilde{\vec A}=\vec A-\nabla_{\vec x}w\\
\tilde\Psi_0=\Psi_0+\frac{1}{c}\left(\frac{\partial w}{\partial
t}+\vec v\cdot\nabla_{\vec x}w\right).
\end{cases}
\end{equation}
For definiteness we can take $\vec A$ to satisfy
\begin{equation}\label{MaxVacFull1bjkgjhjhgjgjgkjfhjfdghghligioiuittrPPN22int}
div_{\vec x}\vec A\equiv 0.
\end{equation}
In section \ref{ghfvdgfdjfg} we show that, consistently with
\er{yuythfgfyftydtydtydtyddyyyhhddhhhredPPN111hgghjgint}, under the
change of non-inertial cartesian coordinate system, given by
\er{noninchgravortbstrjgghguittu2int}, the electromagnetic
potentials transform as:
\begin{equation}\label{vhfffngghhjghhgPPNghghghutghfflklhjkjhjhjjgjkghhjint}
\begin{cases}
\Psi'=
\Psi+\frac{1}{c}\left(\frac{dA}{dt}(t)\cdot\vec x+\frac{d\vec
z}{dt}(t)\right)\cdot\left(A(t)\cdot\vec A\right)
\\
\vec A'=A(t)\cdot \vec A\\
\Psi'_0=\Psi_0.
\end{cases}
\end{equation}
The last equation in
\er{vhfffngghhjghhgPPNghghghutghfflklhjkjhjhjjgjkghhjint} clarifies
the name "proper scalar potential". The equalities
\er{vhfffngghhjghhgPPNghghghutghfflklhjkjhjhjjgjkghhjint} are
derived primarily under the choice of the calibration given by
\er{MaxVacFull1bjkgjhjhgjgjgkjfhjfdghghligioiuittrPPN22int}.
However, as can be easily seen by
\er{MaxVacFull1bjkgjhjhgjgjgkjfhjfdghghligioiuittrPPNhjkjhkjgghhjjhjint},
all the equalities in
\er{vhfffngghhjghhgPPNghghghutghfflklhjkjhjhjjgjkghhjint} still
remain to hold, under any other choice of calibration scalar
function $w$, provided that we have $w'=w$ under the transformation
\er{noninchgravortbstrjgghguittu2int}. In particular, under the
Galilean transformations \er{noninchgravortbstrjgghguittu1int} the
electromagnetic potentials transform as:
\begin{equation}\label{vhfffngghhjghhgPPNghghghutghfflklhjkjhjhjjgjkghhjhhhjhgjguint}
\begin{cases}
\Psi'=
\Psi+\frac{1}{c}\vec w\cdot\vec A
\\
\vec A'=\vec A\\
\Psi'_0=\Psi_0.
\end{cases}
\end{equation}

Next we can associate a Lagrangian density related to
electromagnetic field. Given known the charge distribution
$\rho:=\rho(\vec x,t)$, the current distribution $\vec j:=\vec
j(\vec x,t)$ and the vectorial gravitational potential $\vec v:=\vec
v(\vec x,t)$, consider a Lagrangian density $L_1$ defined by
\begin{multline}\label{vhfffngghkjgghPPNint}
L_1\left(\vec A,\Psi,\vec
x,t\right):=\frac{1}{8\pi}\left|-\nabla_{\vec
x}\Psi-\frac{1}{c}\frac{\partial\vec A}{\partial t}+\frac{1}{c}\vec
v\times curl_{\vec x}\vec A\right|^2-\frac{1}{8\pi}\left|curl_{\vec
x}\vec A\right|^2-\left(\rho\Psi-\frac{1}{c}\vec A\cdot\vec
j\right).
\end{multline}
Using \er{vhfffngghhjghhgPPNghghghutghfflklhjkjhjhjjgjkghhjint} we
can deduce that Lagrangian $L_1$ is invariant, under the change of
inertial or non-inertial cartesian coordinate system, given by
\er{noninchgravortbstrjgghguittu2int}. Moreover, if, consistently
with \er{MaxVacFull1bjkgjhjhgjgjgkjfhjfdghghligioiuittrPPNint},
\er{vhfffngghPPN333yuyuint} and
\er{vhfffngghhjghhgPPNghghghutghffugghjhjkjjklint}, we denote
\begin{equation}\label{guigjgjffghPPNint}
\begin{cases}
\vec D=-\nabla_{\vec x}\Psi-\frac{1}{c}\frac{\partial\vec
A}{\partial t}+\frac{1}{c}\vec
v\times curl_{\vec x}\vec A\\
\vec B=curl_{\vec x}\vec A
\\
\vec E=-\nabla_{\vec x}\Psi-\frac{1}{c}\frac{\partial\vec A}{\partial t}
\\
\vec H=curl_{\vec x}\vec A+\frac{1}{c}\vec v\times\left(\nabla_{\vec
x}\Psi-\frac{1}{c}\frac{\partial\vec A}{\partial t}+\frac{1}{c}\vec
v\times curl_{\vec x}\vec A\right)
\\
\Psi_0:=\Psi-\frac{1}{c}\vec A\cdot\vec v,
\end{cases}
\end{equation}
then:
\begin{multline}
\label{vhfffngghkjgghPPNggjgjjkgjint} L_1\left(\vec A,\Psi,\vec
x,t\right)=\frac{1}{8\pi}\left|\vec
D\right|^2-\frac{1}{8\pi}\left|\vec
B\right|^2-\left(\rho\Psi-\frac{1}{c}\vec A\cdot\vec
j\right)=\frac{1}{8\pi}\left|\vec
D\right|^2-\frac{1}{8\pi}\left|\vec
B\right|^2-\rho\Psi_0+\frac{1}{c}\vec A\cdot(\vec j-\rho\vec v).
\end{multline}
Then in section \ref{bhjghjfghfg} we obtain that a configuration
$(\Psi,\vec A)$ is a critical point of the functional
\begin{equation}\label{btfffygtgyggyPPNint}
J_0=\int_0^T\int_{\mathbb{R}^3}L_1\left(\vec A(\vec x,t),\Psi(\vec
x,t),\vec x,t\right)d\vec x dt,
\end{equation}
if and only if we have
\begin{equation}\label{guigjgjffghguygjyfPPNint}
\begin{cases}
curl_{\vec x}\vec H=\frac{4\pi}{c}\vec j+\frac{\partial\vec
D}{\partial
t}\\
div_{\vec x}\vec D=4\pi\rho\\
curl_{\vec x}\vec E+\frac{1}{c}\frac{\partial\vec B}{\partial t}=0\\
div_{\vec x}\vec B=0\\
\vec E=\vec D-\frac{1}{c}\vec v\times\vec B\\
\vec H=\vec B+\frac{1}{c}\vec v\times\vec D,
\end{cases}
\end{equation}
where $(\vec D,\vec B,\vec E,\vec H)$ is given by
\er{guigjgjffghPPNint}. So we get a variational principle related to
Maxwell equations in the form \er{MaxVacFull1bjkgjhjhgjaaaint}. See
also subsection \ref{bhjghjfghfgjkjkjkhjh} for the alternative
Lagrangian of the electromagnetic field.

\subsection{Local gravitational time and Maxwell equations}
Consider an inertial or more generally a non-rotating cartesian
coordinate system $\bf{(*)}$. Then, as before, in this system we
have
\begin{equation}\label{jhhjgjhuiiuint}
\vec v(\vec x,t)=\nabla_{\vec x}Z(\vec x,t),
\end{equation}
where $\vec v$ is the vectorial gravitational potential and $Z$ is a
scalar field. Then define a scalar field $\tau:=\tau(\vec x,t)$ by
the following:
\begin{equation}\label{jhhjgjhuiiuiyint}
\tau(\vec x,t)=t+\frac{1}{c^2}Z(\vec x,t).
\end{equation}
We call the quantity $\tau(\vec x,t)$ by the name local
gravitational time. The name "local" and "gravitational" is quite
clear, since $\tau$ depend on the space and time variables and
derived by characteristic function $Z$ of the gravitational field.
The name "time" will be clarified bellow. Note also that, using
\er{noninchgravortbstrjgghguittu1intmmjhhj} in remark \ref{ugyugg},
one can easily deduce that under the change of inertial coordinate
system $\bf{(*)}$ to $\bf{(**)}$ given by the Galilean
Transformation \er{noninchgravortbstrjgghguittu1int} the local
gravitational time $\tau$ transforms as:
\begin{equation}\label{noninchgravortbstrjgghguittu1intmmjhhjhjhjint}
\tau'\,=\,\tau+\frac{1}{c^2}\vec w\cdot\vec x+\frac{|\vec
w|^2}{2c^2}t \,\approx\, \tau+\frac{1}{c^2}\vec w\cdot\vec x,
\end{equation}
where the last equality in
\er{noninchgravortbstrjgghguittu1intmmjhhjhjhjint} is valid if
$\frac{|\vec w|^2}{c^2}\ll 1$.

Next consider the Maxwell equations in the vacuum of the form
\er{MaxVacFull1bjkgjhjhgjaaaint} and consider a curvilinear change
of variables given by:
\begin{equation}\label{giuuihjghgghjgj78zzrrZZffhhhggygghghjhvbKKint}
\begin{cases}
t'=\tau(\vec x,t):=t+\frac{Z(\vec x,t)}{c^2}\\
\vec x'=\vec x.
\end{cases}
\end{equation}
Then, denoting
\begin{equation}\label{giuuihjghgghjgj78zzrrZZffhhhggygghghjhvbgghhjyuuyKKint}
\begin{cases}
\vec E^*:=\vec D-\frac{1}{c}\,\vec v\times \vec H=\vec
E-\frac{1}{c^2}\vec v\times\left(\vec v\times\vec D\right)
\\
\vec H^*:=\vec B+\frac{1}{c}\,\vec v\times \vec E=\vec
H-\frac{1}{c^2}\vec v\times\left(\vec v\times\vec B\right),
\end{cases}
\end{equation}
by \er{MaxVacFull1bjkgjhjhgjaaaint} we rewrite the Maxwell equations
in the new curvilinear coordinates in the case of time independent
$\vec v$ as:
\begin{equation}\label{MaxMedFullGGffgguiuiouiogghghhgghhjhjhjjgghhgKKint}
\begin{cases}
curl_{\vec x'}\vec H= \frac{4\pi}{c}\vec j+
\frac{1}{c}\frac{\partial \vec E^*}{\partial
t'},\\
div_{\vec x'}\vec E^*= 4\pi\left(\rho+\frac{1}{c^2}\,\vec v\cdot\vec j\right),\\
curl_{\vec x'}\vec E+\frac{1}{c}\frac{\partial\vec H^*}{\partial
t'}=0,\\
div_{\vec x'}\vec H^*=0,\\
\vec E^*=\vec E-\frac{1}{c^2}\vec v\times\left(\vec v\times\vec
D\right)
\\
\vec H^*=\vec H-\frac{1}{c^2}\vec v\times\left(\vec v\times\vec
B\right)\\
\vec E=\vec D-\frac{1}{c}\,\vec v\times \vec B,\\
\vec H=\vec B+\frac{1}{c}\,\vec v\times \vec D,
\end{cases}
\end{equation}
(See section \ref{GIGIGU} for details). In particular, in the
approximation, up to the order $\left(\frac{|\vec v|}{c}\right)^2\ll
1$ we have $\vec E^*\approx\vec E$ and $\vec H^*\approx\vec H$ and
then the approximate Maxwell equations have the form:
\begin{equation}\label{MaxMedFullGGffgguiuiouiogghghhgghhjhjhjjgghhgjhgghhhjkiljklKKint}
\begin{cases}
curl_{\vec x'}\vec H= \frac{4\pi}{c}\vec j+
\frac{1}{c}\frac{\partial \vec E}{\partial
t'},\\
div_{\vec x'}\vec E= 4\pi\left(\rho+\frac{1}{c^2}\,\vec v\cdot\vec j\right),\\
curl_{\vec x'}\vec E+\frac{1}{c}\frac{\partial\vec H}{\partial
t'}=0,\\
div_{\vec x'}\vec H=0,\\
\vec E=\vec D-\frac{1}{c}\,\vec v\times \vec B,\\
\vec H=\vec B+\frac{1}{c}\,\vec v\times \vec D.
\end{cases}
\end{equation}
The first four equations in
\er{MaxMedFullGGffgguiuiouiogghghhgghhjhjhjjgghhgjhgghhhjkiljklKKint}
form a following system of equation:
\begin{equation}\label{MaxMedFullGGffgguiuiouiogghghhgghhjhjhjjgghhgjhgghhhjkiljklKKHHint}
\begin{cases}
curl_{\vec x'}\vec H= \frac{4\pi}{c}\vec j^*+
\frac{1}{c}\frac{\partial \vec E}{\partial
t'},\\
div_{\vec x'}\vec E= 4\pi\rho^*,\\
curl_{\vec x'}\vec E+\frac{1}{c}\frac{\partial\vec H}{\partial
t'}=0,\\
div_{\vec x'}\vec H=0,
\end{cases}
\end{equation}
where
\begin{equation}\label{fgjyhyfgfjjjkint}
\vec j^*:=\vec
j\quad\text{and}\quad\rho^*:=\left(\rho+\frac{1}{c^2}\,\vec
v\cdot\vec j\right)
\end{equation}
The system
\er{MaxMedFullGGffgguiuiouiogghghhgghhjhjhjjgghhgjhgghhhjkiljklKKHHint}
coincides with the classical Maxwell equations of the usual
Electrodynamics. Therefore, given known $\vec v$, $\rho$ and $\vec
j$,
\er{MaxMedFullGGffgguiuiouiogghghhgghhjhjhjjgghhgjhgghhhjkiljklKKHHint}
could be solved as easy as the usual wave equation, for example by
the method of retarded potentials. Then backward to
\er{giuuihjghgghjgj78zzrrZZffhhhggygghghjhvbKKint} change of
variables could be made in order to deduce the electromagnetic
fields in coordinates $(\vec x,t)$. Next note that, since we defined
$t'=\tau$, all the above clarifies the name "time" of the quantity
$\tau$. Finally we would like to note that if we have a motion of
some material body with the place $\vec r(t)$ and the velocity $\vec
u(t):=\frac{d\vec r}{dt}(t)$ and we associate the local
gravitational time $\tau$ with this body then clearly
\begin{equation}\label{fgjyhyfgfjjjkjkkjint}
d\tau\,=\, \left(1+\frac{1}{c^2}\vec u(t)\cdot \vec v\left(\vec
r(t),t\right)\right)\,dt\,\approx\, dt,
\end{equation}
where the last equality in \er{fgjyhyfgfjjjkjkkjint} is valid if we
have
\begin{equation}\label{fgjyhyfgfjjjkintkk}
\left(\frac{|\vec v|}{c}\right)^2\ll
1\quad\text{and}\quad\left(\frac{|\vec u(t)|}{c}\right)^2\ll 1.
\end{equation}
So we can use the local gravitational time $\tau$ in the approximate
calculations instead of the true time $t$.

\subsection{Motion of the particles in the gravitational and electromagnetic fields and invariance of Shr\"{o}dinger and Pauli equations}
Given a classical particle with inertial mass $m$, charge $\sigma$,
place $\vec r(t)$ and velocity $\vec u(t)=\frac{d\vec r}{dt}(t)$ in
the outer gravitational field with the vectorial gravitational
potential $\vec v(\vec x,t)$, the outer electromagnetic field with
vectorial and scalar potentials $\vec A(\vec x,t)$ and $\vec
\Psi(\vec x,t)$, and additional conservative field with scalar
potential $V(\vec x,t)$ we consider a Lagrangian:
\begin{equation}\label{vhfffngghkjgghfjjint}
L_0\left(\frac{d\vec r}{dt},\vec
r,t\right):=\frac{m}{2}\left|\frac{d\vec r}{dt}-\vec v(\vec
r,t)\right|^2-\sigma\left(\Psi(\vec r,t)-\frac{1}{c}\vec A(\vec
r,t)\cdot\frac{d\vec r}{dt}\right)+V(\vec r,t).
\end{equation}
Then this Lagrangian is invariant under the change of non-inertial
coordinate system, given by \er{noninchgravortbstrjgghguittu2int}.
Moreover, we can show that a trajectory $\vec
r(t):[0,T]\to\mathbb{R}^3$ is a critical point of the functional
\begin{equation}\label{btfffygtgyggyijhhkkint}
J_0=\int_0^T L_0\left(\frac{d\vec r}{dt}(t),\vec r(t),t\right)dt.
\end{equation}
if and only if, consistently with
\er{noninchgravortbstrjgghguittu2gjgghhjhghjhjgghgghghghtytythvfghfgghjggint}
and \er{LorenzChlljklljklkkzzzint}, we have
\begin{multline}\label{vhfffngghkjgghggtghjgfhjoyuiyuyhiyyukukyihyuint}
m\frac{d^2\vec r}{dt^2}=m\left(\frac{\partial}{\partial t}\vec
v(\vec r,t)+\nabla_{\vec x}\left(\frac{1}{2}\left|\vec v(\vec
r,t)\right|^2\right)-\frac{d\vec r}{dt}\times curl_{\vec x}\vec
v(\vec r,t)\right)+\nabla_{\vec x}V\left(\vec
r,t\right)\\+\sigma\vec E(\vec r,t)+\frac{\sigma}{c}\frac{d\vec
r}{dt}\times \vec B(\vec r,t),
\end{multline}
where $\vec E$ and $\vec B$ are given by
\er{MaxVacFull1bjkgjhjhgjgjgkjfhjfdghghligioiuittrPPNint}.

Next if we define the generalized momentum of the particle $m$ by
\begin{equation}\label{guytyurtydftyiujhint}
\vec P:=\nabla_{\vec r'}L_0\left(\vec r',\vec r,t\right)=m
\frac{d\vec r}{dt}-m\vec v(\vec r,t)+\frac{\sigma}{c}\vec A(\vec
r,t),
\end{equation}
and consider a Hamiltonian
\begin{equation}\label{vhfffngghkjgghfjjhyjjfgint}
H_0\left(\vec P,\vec r,t\right):=\vec P\cdot\frac{d\vec
r}{dt}-L_0\left(\frac{d\vec r}{dt},\vec r,t\right),
\end{equation}
then we obtain:
\begin{equation}\label{vhfffngghkjgghfjjghghghint} H_0\left(\vec
P,\vec r,t\right)= \vec P\cdot\vec v(\vec r,t)+
\frac{1}{2m}\left|\vec P-\frac{\sigma}{c}\vec A(\vec
r,t)\right|^2+\sigma\left(\Psi(\vec r,t)-\frac{1}{c}\vec A(\vec
r,t)\cdot\vec v(\vec r,t)\right)-V\left(\vec r,t\right).
\end{equation}
See subsections \ref{hggyugyuy} and \ref{hggyugyuyhkhj} for the
generalizations of the Lagrangian and Hamiltonian in the case of
system of $n$ classical particles. See also subsection
\ref{gghfhfhfhfh} for the invariance of the classical statistical
Liouville's equation, arisen from this Hamiltonian, under the change
of non-inertial coordinate system. Moreover, see subsection
\ref{hjgutgyutyyt} for the frame independent formulations of the
conditions of thermodynamical equilibrium and canonical and
micro-canonical statistical ensembles.

Next, assume that our coordinate system is inertial. Then since by
\er{MaxVacFull1ninshtrgravortghhghgjkgghklhjgkghghjjkjhjkkggjkhjkhjjhhfhjhklkhkhjjklzzzyyyhjggjhgghhjhNWNWNWNWNWBWHWPPN222int}
we have the following Hamilton-Jacobi type equation
\begin{equation}
\label{MaxVacFull1ninshtrgravortghhghgjkgghklhjgkghghjjkjhjkkggjkhjkhjjhhfhjhklkhkhjjklzzzyyyhjggjhgghhjhNWNWNWNWNWBWHWPPN222kkint}
\begin{cases}
\vec v=\nabla_{\vec x}Z,\\
\frac{\partial Z}{\partial t}+\frac{1}{2}\left|\nabla_{\vec
x}Z\right|^2=-\Phi,
\end{cases}
\end{equation}
where $Z:=Z(\vec x,t)$ is some scalar field and $\Phi$ is the
Newtonian gravitational potential, we rewrite
\er{vhfffngghkjgghfjjint} as:
\begin{equation}\label{vhfffngghkjgghfjjSYSPNkkjjkkint}
L_0\left(\frac{d\vec r}{dt},\vec r,t\right)=L'_0\left(\frac{d\vec
r}{dt},\vec r,t\right)-m\frac{d}{dt}\left\{Z\left(\vec
r(t),t\right)\right\}.
\end{equation}
where
\begin{equation}\label{vhfffngghkjgghfjjSYSPNkkjjkkhhkkint}
L'_0\left(\frac{d\vec r}{dt},\vec r,t\right):=
\frac{m}{2}\left|\frac{d\vec r}{dt}\right|^2-m\Phi(\vec
r,t)-\sigma\left(\Psi(\vec r,t)-\frac{1}{c}\vec A(\vec
r,t)\cdot\frac{d\vec r}{dt}\right)+V\left(\vec r,t\right)
\end{equation}
(see subsection \ref{hggyugyuy} for details). Note that in the given
inertial coordinate system $L'_0$ coincides with the classical
Lagrangian of motion in the gravitational and electromagnetic
fields. Moreover, we rewrite \er{btfffygtgyggyijhhkkint} as:
\begin{equation}\label{btfffygtgyggyijhhkkSYSPNkkint}
J_0=\int_0^T L'_0\left(\frac{d\vec r}{dt},\vec
r,t\right)dt+m\left(Z\left(\vec r(0),0\right)-Z\left(\vec
r(T),T\right)\right).
\end{equation}
Thus the stationary points of the functional $J_0$ will satisfy the
same Euler-Lagrange equations as the stationary points of the
functional
\begin{equation}\label{btfffygtgyggyijhhkkSYSPNkkllkkint}
J'_0=\int_0^T L'_0\left(\frac{d\vec r}{dt},\vec r,t\right)dt,
\end{equation}
provided that the beginning and the ending points of the trajectory
$\vec r(t)$ are fixed.

Next if we consider the motion of a quantum micro-particle with
inertial mass $m$ and charge $\sigma$ in the outer gravitational
field with the vectorial gravitational potential $\vec v(\vec x,t)$,
the outer electromagnetic field with vectorial and scalar potentials
$\vec A(\vec x,t)$ and $\Psi(\vec x,t)$, and additional conservative
field with potential $V(\vec x,t)$, not taking into account the spin
interaction, then the Shr\"{o}dinger equation for this particle is
\begin{equation}\label{vhfffngghkjgghfjjghghghhjghjgghkghggkghghjghint}
i\hbar\frac{\partial\psi}{\partial t}=\hat H_0\cdot\psi,
\end{equation}
where $\psi:=\psi(\vec x,t)\in\mathbb{C}$ is a wave function and
$\hat H_0$ is the Hamiltonian operator. Thus, since by
\er{vhfffngghkjgghfjjghghghint} the Hermitian Hamiltonian operator
has the form of:
\begin{multline}\label{vhfffngghkjgghfjjghghghhjghjgghkghgghjhggjjkgint}
\hat H_0\cdot\psi= -\frac{i\hbar}{2}div_{\vec x}\left\{\psi\vec
v(\vec x,t)\right\}-\frac{i\hbar}{2}\vec v(\vec
x,t)\cdot\nabla_{\vec x}\psi+
\left\{\frac{1}{2m}\left(-i\hbar\nabla_{\vec x}-\frac{\sigma}{c}\vec
A(\vec x,t)\right)\circ\left(-i\hbar\nabla_{\vec
x}-\frac{\sigma}{c}\vec A(\vec
x,t)\right)\right\}\cdot\psi\\+\sigma\left(\Psi(\vec
x,t)-\frac{1}{c}\vec A(\vec x,t)\cdot\vec v(\vec
x,t)\right)\cdot\psi-V\left(\vec x,t\right)\cdot\psi,
\end{multline}
we rewrite the corresponding Shr\"{o}dinger equation as
\begin{multline}\label{vhfffngghkjgghfjjghghghhjghjgghkghgghjhggjjkgfgdiyfgfjkjgjgint}
i\hbar\left(\frac{\partial\psi}{\partial t}+\vec v\cdot\nabla_{\vec
x}\psi\right)+\frac{i\hbar}{2}\left(div_{\vec x}\vec v\right)\psi=\\
-\frac{\hbar^2}{2m}\Delta_{\vec x}\psi+\frac{
i\hbar\sigma}{2mc}div_{\vec x}\left\{\psi\vec A\right\}+\frac{
i\hbar\sigma}{2mc}\vec A\cdot\nabla_{\vec
x}\psi+\left(\sigma\Psi-\frac{\sigma}{c}\vec A\cdot\vec
v+\frac{\sigma^2}{2mc^2}\left|\vec A\right|^2-V\right)\psi.
\end{multline}
Then we can deduce that, under the change of non-inertial cartesian
coordinate system , given by \er{noninchgravortbstrjgghguittu2int},
the Shr\"{o}dinger equation of the form
\er{vhfffngghkjgghfjjghghghhjghjgghkghgghjhggjjkgfgdiyfgfjkjgjgint}
stays invariant, provided that, under
\er{noninchgravortbstrjgghguittu2int} we have
\begin{equation}\label{vyfgjhgjhvhgghint}
\begin{cases}
\psi'=\psi
\\
V'=V
\\
\vec v'=A(t)\cdot \vec v+A'(t)\cdot\vec x+\frac{d\vec z}{dt}(t)\\
\vec A'=A(t)\cdot\vec A\\
\Psi'-\frac{1}{c}\vec A'\cdot\vec v'=\Psi-\frac{1}{c}\vec A\cdot\vec
v .
\end{cases}
\end{equation}
So the laws of Quantum Mechanics are also invariant in every
non-inertial cartesian coordinate system. Next, assume that we are
in some inertial coordinate system and observe the Newtonian Law of
Gravitation in the form of
\er{MaxVacFull1ninshtrgravortghhghgjkgghklhjgkghghjjkjhjkkggjkhjkhjjhhfhjhklkhkhjjklzzzyyyhjggjhgghhjhNWNWBWHWPPN222int}.
Then, as a consequence, we have
\er{MaxVacFull1ninshtrgravortghhghgjkgghklhjgkghghjjkjhjkkggjkhjkhjjhhfhjhklkhkhjjklzzzyyyhjggjhgghhjhNWNWNWNWNWBWHWPPN222int}
for some scalar field $Z$ and the scalar Newtonian gravitational
potential $\Phi$. Thus denoting
\begin{equation}\label{jhgghfiuihyhyiy}
\psi_1:=e^{\frac{im}{\hbar}Z}\psi,
\end{equation}
we rewrite
\er{vhfffngghkjgghfjjghghghhjghjgghkghgghjhggjjkgfgdiyfgfjkjgjgint}
in the given inertial coordinate system as:
\begin{multline*}
i\hbar\frac{\partial\psi_1}{\partial
t}=-\frac{\hbar^2}{2m}\Delta_{\vec x}\psi_1+\frac{
i\hbar\sigma}{2mc}div_{\vec x}\left\{\psi_1\vec A\right\}
+\frac{ i\hbar\sigma}{2mc}\vec A\cdot\nabla_{\vec x}\psi_1
+\left(\sigma\Psi+\frac{\sigma^2}{2mc^2}\left|\vec
A\right|^2-V+m\Phi\right)\psi_1,
\end{multline*}
which coincides with the classical Shr\"{o}dinger equation for this
case. Note also that by Remark \ref{ugyugg}, equality
\er{jhgghfiuihyhyiy} implies that under the change of coordinate
system given by the Galilean Transformation
\er{noninchgravortbstrjgghguittu1int} the quantity $\psi_1$
transforms as:
\begin{equation}\label{jhgghfhjhjhjhyuyuyint}
\psi'_1:=e^{\frac{im}{\hbar}(\vec w\cdot\vec x+\frac{1}{2}|\vec
w|^2t)}\psi_1,
\end{equation}
provided that $\psi'=\psi$. Moreover, \er{jhgghfhjhjhjhyuyuyint}
coincides with the classical law of transformation of the wave
function, under the Galilean Transformation (see section 17 in
\cite{LL}).

Next, again consider the motion of a quantum micro-particle having
the inertial mass $m$ and the charges $\sigma$ with the given
gravitational and electromagnetical fields with potentials $\vec
v(\vec x,t)$, $\vec A(\vec x,t)$ and $\Psi(\vec x,t)$ and additional
conservative field with potential $V(\vec x,t)$, not taking into the
account the the spin interaction. Then consider a Lagrangian density
$L_0$ defined by
\begin{multline}\label{vhfffngghkjgghDDmmkkkZZintkk}
L_0\left(\psi,\vec x,t\right):=
\frac{i\hbar}{2}\left(\left(\frac{\partial\psi}{\partial t}+\vec
v\cdot\nabla_{\vec
x}\psi\right)\cdot\bar\psi-\psi\cdot\left(\frac{\partial\bar\psi}{\partial
t}+\vec v\cdot\nabla_{\vec
x}\bar\psi\right)\right)-\frac{\hbar^2}{2m}\nabla_{\vec
x}\psi\cdot\nabla_{\vec x}\bar\psi\\
-\frac{\hbar\sigma i}{2mc}\left(\nabla_{\vec
x}\psi\cdot\bar\psi-\psi\cdot\nabla_{\vec x}\bar\psi\right)\cdot\vec
A-\frac{\sigma^2}{2mc^2}\left|\vec A\right|^2\psi\cdot\bar\psi
-\sigma\left(\Psi-\frac{1}{c}\vec v\cdot\vec
A\right)\psi\cdot\bar\psi+V\left(\vec x,t\right)\psi\cdot\bar\psi,
\end{multline}
where $\psi\in \mathbb{C}$ is a wave function. Then, as before, we
can prove that $L_0$ is invariant under the change of inertial or
non-inertial cartesian coordinate system, given by
\er{noninchgravortbstrjgghguittu2int}, provided that we take into
account \er{vyfgjhgjhvhgghint}. Moreover, if we consider a
functional
\begin{equation}\label{btfffygtgyggyDDmmkkkZZintkk}
J_0=\int_0^T\int_{\mathbb{R}^3}L_0\left(\psi,\vec x,t\right)d\vec x
dt,
\end{equation}
Then, by \er{vhfffngghkjgghDDmmkkkZZintkk} we get that the
Euler-Lagrange equation for \er{btfffygtgyggyDDmmkkkZZintkk}
coincides with the Shr\"{o}dinger equation in the form of
\er{vhfffngghkjgghfjjghghghhjghjgghkghgghjhggjjkgfgdiyfgfjkjgjgint}.
Next we would like to note that the Lagrangian density $L_0$,
defined by \er{vhfffngghkjgghDDmmkkkZZintkk} obeys $U(1)$ local
symmetry, i.e. for every scalar field $w:=w(\vec x,t)$ one can
easily deduce that $L_0$ in \er{vhfffngghkjgghDDmmkkkZZintkk} is
invariant under the transformation:
\begin{equation}\label{MaxVacFull1bjkgjhjhgjgjgkjfhjfdghghligioiuittrPPNhjkjhkjgghhjjhjintiiihh}
\begin{cases}
\psi\,\to\,e^{-\frac{i\sigma w}{c\hbar}}\,\psi
\\
\Psi\,\to\,\Psi+\frac{1}{c}\frac{\partial w}{\partial t}\\
\vec A\,\to\,\vec A-\nabla_{\vec x}w\\
\vec v\,\to\,\vec v.
\end{cases}
\end{equation}
See subsection \ref{hggyugyuy1} for the generalizations of all
mentioned above about the Shr\"{o}dinger equation to the case of
system of $n$ quantum particles. Furthermore, see subsection
\ref{hilghjhghfdgdg} for the invariance of the Quantum Liouville's
equation, under the change of non-inertial coordinate system.

Next consider the motion of a spin-half quantum micro-particle with
inertial mass $m$ and the charge $\sigma$ in the outer gravitational
and electromagnetical field with potentials $\vec v(\vec x,t)$,
$\vec A(\vec x,t)$ and $\Psi(\vec x,t)$ and additional conservative
field with potential $V(\vec x,t)$. Since the Hamiltonian for a
macro-particle has the form \er{vhfffngghkjgghfjjghghghint}, we
built the Hermitian Hamiltonian operator, taking into account the
spin interaction as
\begin{multline}\label{vhfffngghkjgghfjjghghghSYShmyuuiiuuhmhmiopoopnniukjhjkkkllkkkZZint}
\hat H_0\cdot\psi= -\frac{\hbar^2}{2m}\Delta_{\vec
x}\psi+\frac{i\hbar\sigma}{2mc}div_{\vec x}\left\{\psi\vec A(\vec
x,t)\right\}+\frac{i\hbar\sigma}{2mc}\nabla_{\vec x}\psi\cdot\vec
A(\vec x,t)+\frac{\sigma^2}{2mc^2}\left|\vec A(\vec
x,t)\right|^2\psi\\+\sigma\left(\Psi(\vec x,t)-\frac{1}{c}\vec
v(\vec x,t)\cdot\vec A(\vec x,t)\right)\psi-V\left(\vec
x,t\right)\psi-\frac{i\hbar}{2}div_{\vec x}\left\{\psi\vec v(\vec
x,t)\right\}-\frac{i\hbar}{2}\nabla_{\vec x}\psi\cdot\vec v(\vec
x,t)
\\-\frac{g\sigma\hbar}{2mc}\vec
S\cdot\left(curl_{\vec x}\vec A(\vec
x,t)\,\psi\right)+\frac{\hbar}{4}\vec S\cdot\left(curl_{\vec x}\vec
v(\vec x,t)\,\psi\right),
\end{multline}
where $\psi(\vec x,t)=\left(\psi_1(\vec x,t),\psi_2(\vec
x,t)\right)\in\mathbb{C}^2$ is a two-component wave function, $\hat
H_0$ is the Hamiltonian operator, $\vec S:=(S_1,S_2,S_3)$,
$$S_1=\left(\begin{matrix}0&1\\1&0\end{matrix}\right),\quad
S_2=\left(\begin{matrix}0&-i\\i&0\end{matrix}\right)\quad
S_3=\left(\begin{matrix}1&0\\0&-1\end{matrix}\right)$$ are Pauli
matrices and $g$ is a constant that depends on the type of the
particle (for electron we have $g=1$). Note that, in addition to the
classical term of the spin-magnetic interaction, we added another
term to the Hamiltonian, namely $\frac{\hbar}{4}\vec
S\cdot\left(curl_{\vec x}\vec v(\vec x,t)\,\psi\right)$. This term
vanishes in every non-rotating and, in particular, in every inertial
coordinate system, however it provides the invariance of the
Shr\"{o}dinger-Pauli equation, under the change of non-inertial
cartesian coordinate system, as can be seen in the following Theorem
\ref{gjghghgghgintintrrZZint}. The Shr\"{o}dinger-Pauli equation for
this particle is
\begin{equation}\label{vhfffngghkjgghfjjghghghhjghjgghkghggkghghjghSYSPNnnkkllkkkZZint}
i\hbar\frac{\partial\psi}{\partial t}=\hat H_0\cdot\psi.
\end{equation}
I.e,
\begin{multline}\label{vhfffngghkjgghfjjghghghSYShmyuuiiuuhmhmiopoopnniukjhjkk;l;lkhjjkihjjhkkkkkjjjZZint}
i\hbar\left(\frac{\partial\psi}{\partial t}+\frac{1}{2}div_{\vec
x}\left\{\psi\vec v(\vec x,t)\right\}+\frac{1}{2}\nabla_{\vec
x}\psi\cdot\vec v(\vec x,t)\right)\\=-\frac{\hbar^2}{2m}\Delta_{\vec
x}\psi+\frac{i\hbar\sigma}{2mc}div_{\vec x}\left\{\psi\vec A(\vec
x,t)\right\}+\frac{i\hbar\sigma}{2mc}\nabla_{\vec x}\psi\cdot\vec
A(\vec x,t)+\frac{\sigma^2}{2mc^2}\left|\vec A(\vec
x,t)\right|^2\psi\\
%
%
%
+\sigma\left(\Psi(\vec x,t)-\frac{1}{c}\vec v(\vec x,t)\cdot\vec
A(\vec x,t)\right)\psi-V\left(\vec
x,t\right)\psi+\frac{\hbar}{2}\vec
S\cdot\left(\left(\frac{1}{2}curl_{\vec x}\vec v(\vec
x,t)-\frac{g\sigma}{mc}curl_{\vec x}\vec A(\vec
x,t)\right)\,\psi\right).
\end{multline}
In subsection \ref{hjjghjghggh} we prove the following:
\begin{theorem}\label{gjghghgghgintintrrZZint}
Consider that the change of some cartesian coordinate system $(*)$
to another cartesian coordinate system $(**)$ is given by
\er{noninchgravortbstrjgghguittu2int},
where $A(t)\in SO(3)$ is a rotation. Next, assume that in the
coordinate system $(**)$ we observe a validity of the
Shr\"{o}dinger-Pauli equation of the form:
\begin{multline}\label{MaxVacFull1ninshtrredPPNintrrZZint}
i\hbar\left(\frac{\partial\psi'}{\partial t'}+\frac{1}{2}div_{\vec
x'}\left\{\psi'\vec v'\right\}+\frac{1}{2}\nabla_{\vec
x'}\psi'\cdot\vec v'\right)=-\frac{\hbar^2}{2m'}\Delta_{\vec
x'}\psi'+\frac{i\hbar\sigma'}{2m'c}div_{\vec x'}\left\{\psi'\vec
A'\right\}+\frac{i\hbar\sigma'}{2m'c}\nabla_{\vec x'}\psi'\cdot\vec
A'\\+\frac{(\sigma')^2}{2m'c^2}\left|\vec A'\right|^2\psi'
%
%
%
+\sigma'\left(\Psi'-\frac{1}{c}\vec v'\cdot\vec
A'\right)\psi'-V'\psi'+\frac{\hbar}{2}\vec
S\cdot\left(\left(\frac{1}{2}curl_{\vec x'}\vec
v'-\frac{g'\sigma'}{m'c}curl_{\vec x'}\vec A'\right)\,\psi'\right),
\end{multline}
where $\psi\in\mathbb{C}^2$. Then in the coordinate system $(*)$ we
have the validity of Shr\"{o}dinger-Pauli equation of the same as
\er{MaxVacFull1ninshtrredPPNintrrZZint} form:
\begin{multline}\label{MaxVacFull1ninshtrhjkkredPPNintrrZZint}
i\hbar\left(\frac{\partial\psi}{\partial t}+\frac{1}{2}div_{\vec
x}\left\{\psi\vec v\right\}+\frac{1}{2}\nabla_{\vec x}\psi\cdot\vec
v\right)=-\frac{\hbar^2}{2m}\Delta_{\vec
x}\psi+\frac{i\hbar\sigma}{2mc}div_{\vec x}\left\{\psi\vec
A\right\}+\frac{i\hbar\sigma}{2mc}\nabla_{\vec x}\psi\cdot\vec
A\\+\frac{\sigma^2}{2mc^2}\left|\vec A\right|^2\psi
%
%
%
+\sigma\left(\Psi-\frac{1}{c}\vec v\cdot\vec
A\right)\psi-V\psi+\frac{\hbar}{2}\vec
S\cdot\left(\left(\frac{1}{2}curl_{\vec x}\vec
v-\frac{g\sigma}{mc}curl_{\vec x}\vec A\right)\,\psi\right).
\end{multline}
provided that
\begin{equation}\label{yuythfgfyftydtydtydtyddyyyhhddhhhredPPN111hgghjgintintrrZZint}
\begin{cases}
g'=g\\
V'=V,\\
\sigma'=\sigma,\\
m'=m,\\
\vec v'=A(t)\cdot \vec v+\frac{dA}{dt}(t)\cdot\vec x+\frac{d\vec z}{dt}(t),\\
\vec A'=A(t)\cdot \vec A,\\
\Psi'-\vec v'\cdot\vec A'=\Psi-\vec v\cdot\vec A,\\
\psi'=U(t)\cdot\psi,
\end{cases}
\end{equation}
where $U(t)\in SU(2)$ is some special unitary $2\times 2$ matrix
i.e. $U(t)\in\mathbb{C}^{2\times 2}$, $det\,U(t)=1$, $U(t)\cdot
U^*(t)=I$ where $U^*(t)$ is the Hermitian adjoint to $U(t)$ matrix:
$U^*(t):=\bar U(t)^T$ and $I$ is the identity $2\times 2$ matrix.
Moreover, $U(t)$ is characterized by the equality:
\begin{equation}\label{gyfyfgfgfgghZZint}
U^*(t)\cdot\vec S\cdot U(t)=A(t)\cdot\vec S.
\end{equation}
\end{theorem}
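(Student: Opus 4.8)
The plan is to substitute the transformation rules \er{yuythfgfyftydtydtydtyddyyyhhddhhhredPPN111hgghjgintintrrZZint} into the assumed primed equation \er{MaxVacFull1ninshtrredPPNintrrZZint} and show that, after left-multiplication by $U^*(t)$, it collapses exactly onto \er{MaxVacFull1ninshtrhjkkredPPNintrrZZint}. The argument splits cleanly into an \emph{orbital} part, which is word-for-word the already established invariance of the spinless equation \er{vhfffngghkjgghfjjghghghhjghjgghkghgghjhggjjkgfgdiyfgfjkjgjgint} under \er{vyfgjhgjhvhgghint}, and a genuinely new \emph{spin} part, in which the time dependence of $U(t)$ and the added term $\tfrac{\hbar}{4}\vec S\cdot(curl_{\vec x}\vec v\,\psi)$ must conspire. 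First I would record the chain rule for \er{noninchgravortbstrjgghguittu2int}: since $A(t)$ is orthogonal, $\nabla_{\vec x'}=A(t)\nabla_{\vec x}$ on any field pulled back through the coordinate change. Writing $\mathcal C:=\partial_t+\vec v\cdot\nabla_{\vec x}+\tfrac12\,div_{\vec x}\vec v$ for the symmetric convective operator forming the left-hand side of the Pauli equation, the computation that proves the scalar invariance shows in particular that $\mathcal C$ is coordinate independent; hence, since $U(t)$ is constant in $\vec x$,
\[
\mathcal C'[\psi']=\mathcal C'[U\psi]=U\,\mathcal C'[\psi]+\dot U\,\psi=U\,\mathcal C[\psi]+\dot U\,\psi .
\]

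An entirely analogous (and already available) computation shows that every orbital term on the right-hand side — the Laplacian, the two $\vec A$-coupling terms, $|\vec A|^2$, the invariant combination $\sigma(\Psi-\tfrac1c\vec v\cdot\vec A)$, and $V$ — transforms by a single overall factor $U$, because $U$ commutes with $\nabla_{\vec x'}$ and because the rotations produced by $\vec A'=A\vec A$ and $\nabla_{\vec x'}=A\nabla_{\vec x}$ cancel through $A^TA=I$. Thus, up to the spin term, the primed equation equals $U$ times the unprimed equation plus the single anomaly $i\hbar\dot U\psi$ on the left.

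Next I would treat the spin term, which requires three ingredients. (i) Existence of a smooth lift $U(t)\in SU(2)$ of $A(t)$ obeying \er{gyfyfgfgfgghZZint}: I solve the linear ODE $\dot U=U\cdot(-\tfrac{i}{2}\vec\omega\cdot\vec S)$, where $\vec\omega$ is the axial vector of the body angular velocity matrix $A^T\dot A$, with $U(0)$ a chosen preimage of $A(0)$; differentiating $U^*\vec S U$ shows \er{gyfyfgfgfgghZZint} is preserved for all $t$. (ii) Relation of $\dot U$ to $A$: differentiating \er{gyfyfgfgfgghZZint} gives $[A\vec S,\,U^*\dot U]=\dot A\vec S$, and setting $U^*\dot U=\tfrac{i}{2}\vec q\cdot\vec S$ and using the Pauli commutation relations yields $\dot A=-A[\vec q]_\times$, whence $\vec q=-\vec\omega$. (iii) Transformation of the two curls: the vector potential, being a genuine vector field, transforms homogeneously, $curl_{\vec x'}\vec A'=A\,curl_{\vec x}\vec A$, whereas the speed-like field picks up the rigid-rotation term $curl_{\vec x'}\vec v'=A\,curl_{\vec x}\vec v+2\vec\omega_s$, with $\vec\omega_s=A\vec\omega$ the space angular velocity (using $curl(\vec\omega_s\times\vec x')=2\vec\omega_s$). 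Applying \er{gyfyfgfgfgghZZint} in the form $(A\vec d\cdot\vec S)U=U(\vec d\cdot\vec S)$ turns the homogeneous pieces of the primed spin term into $U$ times the unprimed spin term, while the inhomogeneous piece contributes
\[
\frac{\hbar}{4}\,U\bigl((A^T\,2\vec\omega_s)\cdot\vec S\bigr)\psi=\frac{\hbar}{2}\,U(\vec\omega\cdot\vec S)\psi .
\]

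Finally I would assemble. The left-hand anomaly is $i\hbar\dot U\psi=i\hbar\,U\bigl(-\tfrac{i}{2}\vec\omega\cdot\vec S\bigr)\psi=\tfrac{\hbar}{2}U(\vec\omega\cdot\vec S)\psi$, which is \emph{identical} to the inhomogeneous spin contribution on the right. Hence the same anomaly appears on both sides and cancels; multiplying through by $U^*$ then leaves precisely \er{MaxVacFull1ninshtrhjkkredPPNintrrZZint}. The main obstacle is exactly this matching: one must get the numerical factor right (the $\tfrac12$ multiplying $curl_{\vec x}\vec v$ against the rigid-body identity $curl(\vec\omega\times\vec x)=2\vec\omega$) and the sign $\vec q=-\vec\omega$ consistent, since it is this bookkeeping — together with the deliberate choice of the coefficient $\tfrac{\hbar}{4}$ in the added term — that makes the equation consistent in a rotating frame. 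Everything else reduces to the routine orbital computation already carried out for the spinless Schr\"odinger equation.
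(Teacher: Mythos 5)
Your proposal is correct and is essentially the paper's own proof: the orbital terms are disposed of by the already-established spinless invariance, and the decisive step --- cancelling the left-hand anomaly $i\hbar\frac{dU}{dt}\cdot\psi$ against the inhomogeneous part of the transformed spin term produced by $curl_{\vec x'}\vec v'$, via the Pauli-algebra identity relating $U^*(t)\cdot\frac{dU}{dt}(t)$ to the angular velocity of $A(t)$ (the paper's \er{gyfyfgfgfgghyuyyugghhgghghhhjuytytytytytyyuuyghhgjhjhZZ}, with your bookkeeping $2\vec q=\vec w$ and the factor $\tfrac{\hbar}{4}$ coming out the same) --- is exactly what the paper does by differentiating \er{gyfyfgfgfgghZZint}. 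Your only departure is that you \emph{construct} $U(t)$ by integrating the ODE $\frac{dU}{dt}=-\frac{i}{2}\,U\cdot\left(\vec\omega\cdot\vec S\right)$ and verify that \er{gyfyfgfgfgghZZint} is propagated in time, whereas the paper obtains $U(t)$ from the $SU(2)\to SO(3)$ double cover and then deduces $U^*\cdot\frac{dU}{dt}$ by differentiating \er{gyfyfgfgfgghZZint}; these are the same computation read in opposite directions, so the two arguments coincide in substance.
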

Next, again consider the motion of a quantum micro-particle with
spin-half, inertial mass $m$ and the charge $\sigma$ with the given
gravitational and electromagnetical fields with potentials $\vec
v(\vec x,t)$, $\vec A(\vec x,t)$ and $\Psi(\vec x,t)$ and additional
conservative field with potential $V(\vec x,t)$, taking into the
account spin interaction. Then consider a Lagrangian density $L$
defined by
\begin{multline}\label{vhfffngghkjgghDDmmkkkZZint}
L\left(\psi,\vec x,t\right):=
\frac{i\hbar}{2}\left(\left(\frac{\partial\psi}{\partial t}+\vec
v\cdot\nabla_{\vec
x}\psi\right)\cdot\bar\psi-\psi\cdot\left(\frac{\partial\bar\psi}{\partial
t}+\vec v\cdot\nabla_{\vec
x}\bar\psi\right)\right)-\frac{\hbar^2}{2m}\nabla_{\vec
x}\psi\cdot\nabla_{\vec x}\bar\psi\\
-\frac{\hbar\sigma i}{2mc}\left(\nabla_{\vec
x}\psi\cdot\bar\psi-\psi\cdot\nabla_{\vec x}\bar\psi\right)\cdot\vec
A-\frac{\sigma^2}{2mc^2}\left|\vec A\right|^2\psi\cdot\bar\psi
-\sigma\left(\Psi-\frac{1}{c}\vec v\cdot\vec
A\right)\psi\cdot\bar\psi\\ -\frac{\hbar}{2}\left(\left(\vec
S\cdot\left(\frac{1}{2}curl_{\vec x}\vec
v-\frac{g\sigma}{mc}curl_{\vec x}\vec
A\right)\right)\cdot\psi\right)\cdot\bar\psi+V\left(\vec
x,t\right)\psi\cdot\bar\psi,
\end{multline}
where $\psi\in \mathbb{C}^2$ is a two-component wave function. Then
similarly to the proof of Theorem \ref{gjghghgghgintintrrZZint} we
can prove that $L$ is invariant under the change of inertial or
non-inertial cartesian coordinate system, given by
\er{noninchgravortbstrjgghguittu2int}, provided that we take into
account
\er{yuythfgfyftydtydtydtyddyyyhhddhhhredPPN111hgghjgintintrrZZint}.
Moreover, if we consider a functional
\begin{equation}\label{btfffygtgyggyDDmmkkkZZint}
J=\int_0^T\int_{\mathbb{R}^3}L\left(\psi,\vec x,t\right)d\vec x dt,
\end{equation}
then, by \er{vhfffngghkjgghDDmmkkkZZint} we get that the
Euler-Lagrange equation for \er{btfffygtgyggyDDmmkkkZZint} coincides
with the Shr\"{o}dinger-Pauli equation in the form of
\er{vhfffngghkjgghfjjghghghSYShmyuuiiuuhmhmiopoopnniukjhjkk;l;lkhjjkihjjhkkkkkjjjZZint}.
Next we would like to note that, as before, the Lagrangian density
$L$, defined by \er{vhfffngghkjgghDDmmkkkZZint} obeys $U(1)$ local
symmetry, i.e. for every scalar field $w:=w(\vec x,t)$ one can
easily deduce that $L$ in \er{vhfffngghkjgghDDmmkkkZZint} is
invariant under the transformation:
\begin{equation}\label{MaxVacFull1bjkgjhjhgjgjgkjfhjfdghghligioiuittrPPNhjkjhkjgghhjjhjintiiihhintsp}
\begin{cases}
\psi\,\to\,e^{-\frac{i\sigma w}{c\hbar}}\,\psi
\\
\Psi\,\to\,\Psi+\frac{1}{c}\frac{\partial w}{\partial t}\\
\vec A\,\to\,\vec A-\nabla_{\vec x}w\\
\vec v\,\to\,\vec v.
\end{cases}
\end{equation}
See subsection \ref{ghghggjghfgjk} for the generalization of the
Shr\"{o}dinger-Pauli equation to the case of a system of $n$
spin-half micro-particles. See also subsection
\ref{hilghjhghfdgdg11} for the Quantum Liouville's equation  of a
system of $n$ spin-half particles.

\subsection{Unified gravitational-electromagnetic field and conservation laws in the case of the Newtonian-type gravity}
Similarly to our assumption that the electromagnetic field is
influenced by gravitational field, we also can assume that the
gravitational field is influenced by electromagnetic field. We
remind that we assume that the first approximation of the law of
gravitation is given by
\er{MaxVacFull1ninshtrgravortghhghgjkgghklhjgkghghjjkjhjkkggjkhjkhjjhhfhjhklkhkhjjklzzzyyyNWBWHWPPN222int}.
However, till now we said nothing about the relation between the
density of inertial and gravitational masses. If $\mu$ is the
density of inertial masses and $M$ is the density of gravitational
masses, then consistently with the classical Newtonian theory of
gravitation we assume that in the absence of essential
electromagnetic fields we should have
\begin{equation}\label{gghjgghfghdint}
M=\mu.
\end{equation}
In order to satisfy the conservation laws of linear and angular
momentums and energy, consider the following conserved scalar field
$Q$, that we call "electromagnetical-gravitational" mass density,
which is negligible in the absence of electromagnetic fields and
satisfies the identity
\begin{equation}
\label{MaxVacFull1ninshtrgravortghhghgjkgghklhjgkghghjjkjhjkkggjkhjkhjjhhfhjhklkhkhjjklzzzyyyNWNWBWHWPPNint}
\frac{\partial Q}{\partial t}+div_{\vec x}\left\{Q\vec v\right\}=-
div_\vec x\left\{\frac{1}{4\pi c}\vec D\times \vec B\right\}
\end{equation}
in the general case. Then, instead of \er{gghjgghfghdint}, for the
general case of gravitational-electromagnetic fields we consider the
following relation between the gravitational and inertial mass
densities
\begin{equation}\label{gghjgghfghdkjgjjint}
M=\mu+Q.
\end{equation}
Then by
\er{MaxVacFull1ninshtrgravortghhghgjkgghklhjgkghghjjkjhjkkggjkhjkhjjhhfhjhklkhkhjjklzzzyyyNWBWHWPPN222int}
and \er{gghjgghfghdkjgjjint} we have the following law of
gravitation:
\begin{equation}
\label{MaxVacFull1ninshtrgravortghhghgjkgghklhjgkghghjjkjhjkkggjkhjkhjjhhfhjhklkhkhjjklzzzyyyNWBWHWPPNint}
\begin{cases}
curl_{\vec x}\left(curl_{\vec x}\vec v\right)= 0,\\
\frac{\partial}{\partial t}\left(div_{\vec x}\vec v\right)+div_{\vec
x}\left\{\left(div_{\vec x}\vec v\right)\vec
v\right\}+\frac{1}{4}\left|d_{\vec x}\vec v+\{d_{\vec x}\vec
v\}^T\right|^2-\left(div_{\vec x}\vec v\right)^2=  -4\pi G(\mu+Q).
\end{cases}
\end{equation}
The laws
\er{MaxVacFull1ninshtrgravortghhghgjkgghklhjgkghghjjkjhjkkggjkhjkhjjhhfhjhklkhkhjjklzzzyyyNWNWBWHWPPNint}
and
\er{MaxVacFull1ninshtrgravortghhghgjkgghklhjgkghghjjkjhjkkggjkhjkhjjhhfhjhklkhkhjjklzzzyyyNWBWHWPPNint}
are invariant under the change of non-inertial cartesian coordinate
system, given by \er{noninchgravortbstrjgghguittu2int}, provided
that, under \er{noninchgravortbstrjgghguittu2int} we have $Q'=Q$ and
$\mu'=\mu$. In particular, in the inertial coordinate system $(*)$
we should have:
\begin{equation}
\label{MaxVacFull1ninshtrgravortghhghgjkgghklhjgkghghjjkjhjkkggjkhjkhjjhhfhjhklkhkhjjklzzzyyyhjggjhgghhjhNWNWBWHWPPNint}
\begin{cases}
curl_{\vec x}\vec v= 0,\\
\frac{\partial\vec v}{\partial t}+d_\vec x\vec v\cdot\vec v=
-\nabla_{\vec x}\Phi,
\end{cases}
\end{equation}
where $\Phi$ is the scalar gravitational potential which is a scalar
field satisfying in every coordinate system:
\begin{equation}
\label{MaxVacFull1ninshtrgravortghhghgjkgghklhjgkghghjjkjhjkkggjkhjkhjjhhfhjhklkhkhjjklzzzyyyhjggjhgghhjhNWNWNWBWHWPPNint}
\Delta_{\vec x}\Phi=4\pi G(\mu+Q).
\end{equation}
\begin{remark}\label{ghghvghhgggh}
Lemma \ref{fgbfghfh} from Appendix gives some insight that the
"electromagnetical-gravitational" mass density $Q$ in
\er{MaxVacFull1ninshtrgravortghhghgjkgghklhjgkghghjjkjhjkkggjkhjkhjjhhfhjhklkhkhjjklzzzyyyNWNWBWHWPPNint}
should have the values of the same order as the quantity
$\frac{1}{c^2}\left(|\vec D|^2+|\vec B|^2\right)$ and therefore, in
the usual circumstances is negligible with respect to the inertial
mass density $\mu$. Thus we can write $Q\approx 0$ in
\er{MaxVacFull1ninshtrgravortghhghgjkgghklhjgkghghjjkjhjkkggjkhjkhjjhhfhjhklkhkhjjklzzzyyyNWBWHWPPNint},
i.e. the force of gravity in an inertial coordinate system
approximately equals to the classical Newtonian force of gravity.
\end{remark}
Next consider the Maxwell equation in the vacuum in the form
\er{MaxVacFull1bjkgjhjhgjaaaint} and consistently with
\er{noninchgravortbstrjgghguittu2gjgghhjhghjhjgghgghghghtytythvfghfgghjggint},
consider the second Law of Newton for the moving continuum with the
inertial mass density $\mu$ and the field of velocities $\vec u$:
\begin{equation}\label{MaxVacFull1ninshtrgravortghhghgjkgghklhjgkghghjjkjhjkkggjkhjkhjjhhfhjhkjkhbbgjhzzzyyykkknnnNWBWHWPPNint}
\mu\frac{\partial\vec u}{\partial t}+\mu d_{\vec x}\vec u\cdot\vec
u=-\mu\vec u\times curl_{\vec x}\vec v+\mu\partial_{t}\vec
v+\mu\nabla_{\vec x}\left(\frac{1}{2}|\vec v|^2\right)+\rho\vec
E+\frac{1}{c}\vec j\times\vec B+\vec G.
\end{equation}
where $\rho\vec E+\frac{1}{c}\vec j\times\vec B$ is the volume
density of the Lorentz force and $\vec G$ is the total volume
density of all non-gravitational and non-electromagnetic forces
acting on the continuum with mass density $\mu$. Then, in section
\ref{GravElectro} we prove that in inertial coordinate systems we
have conservation laws of the linear momentum, the angular momentum
and the energy. More precisely, we have the following theorem:
\begin{theorem}
Consider the Maxwell equation for the vacuum in the form
\er{MaxVacFull1bjkgjhjhgjaaaint} and the second Law of Newton for
the moving continuum in the form
\er{MaxVacFull1ninshtrgravortghhghgjkgghklhjgkghghjjkjhjkkggjkhjkhjjhhfhjhkjkhbbgjhzzzyyykkknnnNWBWHWPPNint}.
Next, assume that in some cartesian coordinate system $(*)$ we
observe the gravitational law in the form of
\er{MaxVacFull1ninshtrgravortghhghgjkgghklhjgkghghjjkjhjkkggjkhjkhjjhhfhjhklkhkhjjklzzzyyyhjggjhgghhjhNWNWBWHWPPNint},
\er{MaxVacFull1ninshtrgravortghhghgjkgghklhjgkghghjjkjhjkkggjkhjkhjjhhfhjhklkhkhjjklzzzyyyhjggjhgghhjhNWNWNWBWHWPPNint}
and
\er{MaxVacFull1ninshtrgravortghhghgjkgghklhjgkghghjjkjhjkkggjkhjkhjjhhfhjhklkhkhjjklzzzyyyNWNWBWHWPPNint}.
Then in the system $(*)$ we have the following laws of conservation
of the linear momentum, angular momentum and energy:
\begin{multline}\label{hvkgkjgkjbjkjjkgjglhhkhjyuyghjhhjhjfghfdhgdhdfdhzzzyyyjffjjkkhhkgjgkjhfhjhgfffjgjhgjffgjggjgjggjhgkkkggjgjgnnnmmmNWNWNWNWBWHWMPPNint}
\frac{\partial}{\partial t}\left(\mu\vec u+Q\vec v+\frac{1}{4\pi
c}\,\vec D\times \vec B\right)=\\-div_\vec x\left\{\mu\vec
u\otimes\vec u+Q\vec v\otimes\vec v+\left(\frac{1}{4\pi c}\vec
D\times \vec B\right)\otimes \vec v+\vec v\otimes\left(\frac{1}{4\pi
c}\vec D\times \vec
B\right)\right\}\\
+\frac{1}{4\pi}div_\vec x\left\{\vec D\otimes \vec D+\vec B\otimes
\vec B-\frac{1}{2}\left(|\vec D|^2+|\vec
B|^2\right)I-\frac{1}{G}\nabla_{\vec x}\Phi\otimes\nabla_{\vec
x}\Phi+\frac{1}{2G}\left|\nabla_{\vec x}\Phi\right|^2 I\right\}+\vec
G,
\end{multline}
\begin{multline}\label{hvkgkjgkjbjkjjkgjglhhkhjyuyghjhhjhjfghfdhgdhdfdhzzzyyyjffjjkkhhkgjgkjhfhjhgfffjgjhgjffgjggjgjggjhgkkkggjgjgmomnnnmmmNWNWBWHWMMPPNint}
\frac{\partial}{\partial t}\left(\vec x\times(\mu\vec u)+\vec
x\times(Q\vec v)+\vec x\times\left(\frac{1}{4\pi c}\,\vec D\times
\vec B\right)\right)=\\-div_\vec x\left\{\mu(\vec x\times\vec
u)\otimes\vec u+Q(\vec x\times\vec v)\otimes\vec v+\left(\vec
x\times\left(\frac{1}{4\pi c}\vec D\times \vec
B\right)\right)\otimes \vec v+(\vec x\times\vec
v)\otimes\left(\frac{1}{4\pi c}\vec D\times \vec
B\right)\right\}\\
+\frac{1}{4\pi}div_\vec x\left\{(\vec x\times\vec D)\otimes \vec
D+(\vec x\times\vec B)\otimes \vec B-\frac{1}{G}(\vec
x\times\nabla_{\vec x}\Phi)\otimes\nabla_{\vec
x}\Phi\right\}\\+\frac{1}{8\pi}curl_{\vec x}\left\{\left(|\vec
D|^2+|\vec B|^2-\frac{1}{G}\left|\nabla_{\vec
x}\Phi\right|^2\right)\vec x\right\}+ \vec x\times\vec G,
\end{multline}
and
\begin{multline}\label{MaxVacFull1ninshtrgravortghhghgjkgghklhjgkghghjjkjhjkkggjkhjkhjjhhfhjhkjkhbbgjhhkjhhklhzzzyyykkkgkhjjhgfhjjffghuikkgkjghhjkjknnnmmmNWNWBWHWhhjhhENPPNint}
\frac{\partial}{\partial t}\left(\frac{1}{2}\mu|\vec
u|^2+\frac{1}{2}Q\left|\vec v\right|^2+\frac{\vec D\cdot\vec E+\vec
B\cdot\vec H}{8\pi}-\frac{1}{8\pi G}\big|\nabla_{\vec
x}\Phi\big|^2\right)=\\
-div_\vec x\left\{\left(\frac{\mu|\vec u|^2}{2}\right)\vec
u+\left(\frac{Q|\vec v|^2}{2}\right)\vec v+\frac{1}{2}\left|\vec
v\right|^2\left(\frac{1}{4\pi c}\,\vec D\times \vec
B\right)+\left(\frac{\vec D\cdot\vec E+\vec B\cdot\vec
H}{8\pi}\right)\vec v\right\}
\\
+\frac{1}{4\pi}div_\vec x\left\{(\vec D\otimes \vec D+ \vec B\otimes
\vec B)\cdot \vec v-\frac{1}{2}\left(|\vec D|^2+|\vec
B|^2\right)\vec v-c \vec D\times \vec B\right\}\\-div_\vec
x\left\{\Phi\left(\mu\vec u+Q\vec v+\frac{1}{4\pi c}\,\vec D\times
\vec B\right)\right\}
-\frac{1}{4\pi G}div_{\vec x}\left\{\Phi\frac{\partial}{\partial
t}(\nabla_{\vec x}\Phi)\right\}+\vec G\cdot\vec u.
\end{multline}
\end{theorem}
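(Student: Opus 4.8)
The plan is to verify all three balance laws by direct differentiation, converting the time derivative of each density into a spatial divergence (plus the source $\vec G$) through repeated substitution of the governing equations. Throughout I work in the fixed system $(*)$, where the gravitational law supplies $curl_{\vec x}\vec v=0$, $\partial_t\vec v+d_{\vec x}\vec v\cdot\vec v=-\nabla_{\vec x}\Phi$ and $\Delta_{\vec x}\Phi=4\pi G(\mu+Q)$, I assume the mass-continuity equation $\partial_t\mu+div_{\vec x}\{\mu\vec u\}=0$ for the moving continuum, and I use charge continuity $\partial_t\rho+div_{\vec x}\vec j=0$, which is an automatic consequence of the first two equations in \er{MaxVacFull1bjkgjhjhgjaaaint}. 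It is convenient to abbreviate the field momentum density $\vec g:=\frac{1}{4\pi c}\vec D\times\vec B$.

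For the linear momentum law I treat the three contributions separately. Using continuity, Newton's law for the continuum rewrites as $\partial_t(\mu\vec u)+div_{\vec x}\{\mu\vec u\otimes\vec u\}=-\mu\nabla_{\vec x}\Phi+\rho\vec E+\frac1c\vec j\times\vec B+\vec G$, where $curl_{\vec x}\vec v=0$ and $\partial_t\vec v+d_{\vec x}\vec v\cdot\vec v=-\nabla_{\vec x}\Phi$ collapse the gravitational terms. The conservation identity for $Q$, combined with $\partial_t\vec v=-\nabla_{\vec x}\Phi-d_{\vec x}\vec v\cdot\vec v$, gives $\partial_t(Q\vec v)+div_{\vec x}\{Q\vec v\otimes\vec v\}=-Q\nabla_{\vec x}\Phi-\vec v\,div_{\vec x}\vec g$. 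For the field momentum I differentiate $\vec g$ and insert $\partial_t\vec D=c\,curl_{\vec x}\vec H-4\pi\vec j$ and $\partial_t\vec B=-c\,curl_{\vec x}\vec E$; the current term cancels the Lorentz contribution $\frac1c\vec j\times\vec B$, and after substituting the constitutive relations $\vec E=\vec D-\frac1c\vec v\times\vec B$, $\vec H=\vec B+\frac1c\vec v\times\vec D$ the $\vec v$-independent remainder assembles, via the identity $div_{\vec x}\{\vec F\otimes\vec F-\frac12|\vec F|^2 I\}=(div_{\vec x}\vec F)\vec F+(curl_{\vec x}\vec F)\times\vec F$ together with $div_{\vec x}\vec D=4\pi\rho$ and $div_{\vec x}\vec B=0$, into the Maxwell stress $\frac1{4\pi}div_{\vec x}\{\vec D\otimes\vec D+\vec B\otimes\vec B-\frac12(|\vec D|^2+|\vec B|^2)I\}$. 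Finally the pooled gravitational force $-(\mu+Q)\nabla_{\vec x}\Phi$ becomes a divergence: by $\Delta_{\vec x}\Phi=4\pi G(\mu+Q)$ and symmetry of the Hessian one has $div_{\vec x}\{\nabla_{\vec x}\Phi\otimes\nabla_{\vec x}\Phi-\frac12|\nabla_{\vec x}\Phi|^2 I\}=(\Delta_{\vec x}\Phi)\nabla_{\vec x}\Phi$, yielding the gravitational stress term $-\frac1{4\pi G}div_{\vec x}\{\nabla_{\vec x}\Phi\otimes\nabla_{\vec x}\Phi-\frac12|\nabla_{\vec x}\Phi|^2 I\}$.

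The step I expect to be the main obstacle is the reorganization of the remaining $\vec v$-dependent electromagnetic terms. After the cancellations above, the surviving contributions are, up to the factor $\frac1{4\pi c}$, $-(div_{\vec x}\vec D)(\vec v\times\vec B)+(curl_{\vec x}(\vec v\times\vec D))\times\vec B-(curl_{\vec x}(\vec v\times\vec B))\times\vec D-\vec v\,div_{\vec x}(\vec D\times\vec B)$, and these must be shown to equal $-div_{\vec x}\{(\vec D\times\vec B)\otimes\vec v+\vec v\otimes(\vec D\times\vec B)\}$, which reproduces the coupling blocks $\vec g\otimes\vec v+\vec v\otimes\vec g$. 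This is a pure vector-calculus identity; the plan is to expand each $curl_{\vec x}(\vec v\times\vec F)=\vec v\,div_{\vec x}\vec F-\vec F\,div_{\vec x}\vec v+d_{\vec x}\vec F\cdot\vec v-d_{\vec x}\vec v\cdot\vec F$ and to use $curl_{\vec x}\vec v=0$ to match terms in index notation. This bookkeeping, rather than any conceptual difficulty, is the crux.

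The angular momentum law then follows from the linear one by crossing with $\vec x$. Writing the linear balance as $\partial_t\vec p=-div_{\vec x}\Pi+\vec G$ and using $[\vec x\times div_{\vec x}\Pi]_i=\partial_l(\epsilon_{ijk}x_j\Pi_{kl})-\epsilon_{ijk}\Pi_{kj}$, every symmetric non-isotropic block of $\Pi$ (namely $\mu\vec u\otimes\vec u$, $Q\vec v\otimes\vec v$, $\vec g\otimes\vec v+\vec v\otimes\vec g$, $\vec D\otimes\vec D$, $\vec B\otimes\vec B$, $\nabla_{\vec x}\Phi\otimes\nabla_{\vec x}\Phi$) has vanishing antisymmetric remainder $\epsilon_{ijk}\Pi_{kj}$ and produces the flux obtained by inserting $\vec x\times$ into its left factor, while the isotropic pressure block $pI$ with $p=\frac1{8\pi}(|\vec D|^2+|\vec B|^2-\frac1G|\nabla_{\vec x}\Phi|^2)$ contributes $-\vec x\times\nabla_{\vec x}p=curl_{\vec x}\{p\vec x\}$, exactly the stated curl term. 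For the energy law I dot Newton's law with $\vec u$ (using continuity) to handle $\frac12\mu|\vec u|^2$, use the $Q$-identity together with $\partial_t\vec v=-\nabla_{\vec x}\Phi-d_{\vec x}\vec v\cdot\vec v$ for $\frac12 Q|\vec v|^2$, differentiate the electromagnetic energy $\frac1{8\pi}(\vec D\cdot\vec E+\vec B\cdot\vec H)=\frac1{8\pi}(|\vec D|^2+|\vec B|^2)+\vec v\cdot\vec g$ through the Maxwell equations, and treat the gravitational energy $-\frac1{8\pi G}|\nabla_{\vec x}\Phi|^2$ via $\partial_t|\nabla_{\vec x}\Phi|^2=2\nabla_{\vec x}\Phi\cdot\nabla_{\vec x}(\partial_t\Phi)$ together with $\Delta_{\vec x}(\partial_t\Phi)=4\pi G\,\partial_t(\mu+Q)$; this last piece, whose manipulation produces the flux $-\frac1{4\pi G}div_{\vec x}\{\Phi\,\partial_t\nabla_{\vec x}\Phi\}$ and the term $-div_{\vec x}\{\Phi(\mu\vec u+Q\vec v+\vec g)\}$, is the most delicate part of the energy computation.
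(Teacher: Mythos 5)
Your proposal is correct, and every step you flag as delicate does go through; but your decomposition is genuinely different from the paper's, so a comparison is worth recording. The paper never splits the momentum into the absolute pieces $\mu\vec u$, $Q\vec v$, $\frac{1}{4\pi c}\vec D\times\vec B$: it works with relative variables, rewriting Newton's law as a balance for $\mu(\vec u-\vec v)$, adding the field-momentum balance of Lemma \ref{guigiukhn} --- which is proved in the Appendix \emph{without} assuming $curl_{\vec x}\vec v=0$ and therefore carries the extra term $-(d_{\vec x}\vec v)^T\cdot\left(\frac{1}{4\pi c}\vec D\times\vec B\right)$ --- and only at the end converting the term $-\left(div_{\vec x}\left\{\mu(\vec u-\vec v)+\frac{1}{4\pi c}\vec D\times\vec B\right\}\right)\vec v$, by means of the combined relation $\partial_t(\mu+Q)+div_{\vec x}\{(\mu+Q)\vec v\}=-div_{\vec x}\{\mu(\vec u-\vec v)+\frac{1}{4\pi c}\vec D\times\vec B\}$ and the gravitational law, into $\partial_t\left((\mu+Q)\vec v\right)+div_{\vec x}\left\{(\mu+Q)\vec v\otimes\vec v\right\}$ plus the gravitational stress. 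Your crux identity is exactly Lemma \ref{guigiukhn} specialized to the system $(*)$: the two statements differ by the term $curl_{\vec x}\vec v\times\left(\vec D\times\vec B\right)$, equivalently $\left(d_{\vec x}\vec v-\{d_{\vec x}\vec v\}^T\right)\cdot\left(\vec D\times\vec B\right)$, which vanishes there, so your plan (expand the curls, use the symmetry of $d_{\vec x}\vec v$) is sound. Your angular-momentum step --- crossing the \emph{final} assembled law with $\vec x$ and discarding the antisymmetric remainders of the symmetric stress blocks --- is cleaner than the paper's, which crosses the intermediate relative-momentum identity and then adds the crossed $(\mu+Q)\vec v$ relation; your energy bookkeeping (Poisson equation applied directly to $-\frac{1}{8\pi G}|\nabla_{\vec x}\Phi|^2$) is algebraically equivalent to the paper's manipulation of $\partial_t\vec v\cdot\left(\mu(\vec u-\vec v)+\frac{1}{4\pi c}\vec D\times\vec B\right)$. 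What the paper's route buys is lemmas that hold in every cartesian system with no curl hypothesis, the gravitational law entering at one isolated point; what yours buys is transparency, since each conserved density has its own balance law and the two cancellations (Lorentz work against $\vec j\cdot\vec E$, gravitational work against $\nabla_{\vec x}\Phi\cdot\left(\mu\vec u+Q\vec v+\frac{1}{4\pi c}\vec D\times\vec B\right)$) are visible. One caveat you share with the paper: both energy derivations silently use $\vec u\cdot\left(\rho\vec E+\frac{1}{c}\vec j\times\vec B\right)=\vec j\cdot\vec E$, i.e. that the current is convective, $\vec j=\rho\vec u$; without this the matter work term does not cancel the $-\vec j\cdot\vec E$ produced by the field energy.
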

%
%
%
%
%
%

Next given known the distribution of inertial mass density of some
continuum medium $\mu:=\mu(\vec x,t)$, the field of velocities of
this medium $\vec u:=\vec u(\vec x,t)$, the charge density
$\rho:=\rho(\vec x,t)$ and the current density $\vec j:=\vec j(\vec
x,t)$ consider a Lagrangian density $L$ for the unified
gravitational-electromagnetic field, defined by
\begin{multline}\label{vhfffngghkjgghDDint}
L\left(\vec A,\Psi,\vec v,\Phi,\vec p,\vec
x,t\right):=\frac{1}{8\pi}\left|-\nabla_{\vec
x}\Psi-\frac{1}{c}\frac{\partial\vec A}{\partial t}+\frac{1}{c}\vec
v\times curl_{\vec x}\vec A\right|^2-\frac{1}{8\pi}\left|curl_{\vec
x}\vec A\right|^2-\left(\rho\Psi-\frac{1}{c}\vec A\cdot\vec
j\right)\\+\frac{\mu}{2}\left|\vec u-\vec v\right|^2+
\frac{1}{2}\left(d_{\vec x}\vec v+\left\{d_{\vec x}\vec
v\right\}^T\right):\left(d_{\vec x}\vec p+\left\{d_{\vec x}\vec
p\right\}^T\right)-2\left(div_{\vec x}\vec v\right)\left(div_{\vec
x}\vec p\right)\\+\frac{1}{4\pi G}\left(div_{\vec x}\vec
v\right)\left(\frac{\partial \Phi}{\partial t}+\vec
v\cdot\nabla_{\vec x} \Phi\right)+\frac{1}{4\pi
G}\Phi\left(div_{\vec x}\vec v\right)^2-\frac{\Phi}{16\pi
G}\left|d_{\vec x}\vec v+\left\{d_{\vec x}\vec
v\right\}^T\right|^2+\frac{1}{8\pi G}\left|\nabla_{\vec
x}\Phi\right|^2,
\end{multline}
where $\Phi$ is an ancillary proper scalar field and $\vec p$ is an
ancillary proper vector field. Then, as before, we can show that $L$
is invariant under the change of non-inertial cartesian coordinate
system given by \er{noninchgravortbstrjgghguittu2int}, provided
that, under \er{noninchgravortbstrjgghguittu2int} we have
\begin{equation}\label{vyfgjhgjhvhgghintintint}
\begin{cases}
\vec p'=A(t)\cdot\vec p\\
\Phi'=\Phi
\\
\vec v'=A(t)\cdot \vec v+A'(t)\cdot\vec x+\frac{d\vec z}{dt}(t)\\
\vec A'=A(t)\cdot\vec A\\
\Psi'-\frac{1}{c}\vec A'\cdot\vec v'=\Psi-\frac{1}{c}\vec A\cdot\vec
v .
\end{cases}
\end{equation}
Then in section \ref{ghjfhgfdhd} we obtain that a configuration
$(\vec A,\Psi,\vec v,\Phi,\vec p)$ is a critical point of the
functional
\begin{equation}\label{btfffygtgyggyDDint}
J=\int_0^T\int_{\mathbb{R}^3}L\left(\vec A,\Psi,\vec v,\Phi,\vec
p,\vec x,t\right)d\vec x dt.
\end{equation}
if and only if it satisfies
\begin{equation}\label{guigjgjffghguygjyfDDint}
\begin{cases}
curl_{\vec x}\vec H=\frac{4\pi}{c}\vec
j+\frac{1}{c}\frac{\partial\vec D}{\partial
t}\\
div_{\vec x}\vec D=4\pi\rho\\
curl_{\vec x}\vec E+\frac{1}{c}\frac{\partial\vec B}{\partial t}=0\\
div_{\vec x}\vec B=0\\
\vec E=\vec D-\frac{1}{c}\vec v\times\vec B\\
\vec H=\vec B+\frac{1}{c}\vec v\times\vec D\\
curl_{\vec x}\left(curl_{\vec x}\vec v\right)=0
\\
\frac{\partial}{\partial t}\left\{div_{\vec x}\vec v\right\}+\vec
v\cdot\nabla_{\vec x}\left(div_{\vec x}\vec
v\right)+\frac{1}{4}\left|d_{\vec x}\vec v+\left\{d_{\vec x}\vec
v\right\}^T\right|^2=-\Delta_{\vec x}\Phi\\
\left(\mu\vec u-\mu\vec v+\frac{1}{4\pi c}\vec D\times\vec
B\right)=curl_{\vec x}\left(curl_{\vec x}\vec p\right)-\frac{1}{4\pi
G}\left(\frac{\partial}{\partial t}\left(\nabla_{\vec
x}\Phi\right)-curl_{\vec x}\left(\vec v\times\nabla_{\vec
x}\Phi\right)+\left(\Delta_{\vec x}\Phi\right)\vec v\right),
\end{cases}
\end{equation}
where, consistently with \er{guigjgjffghPPNint} we denote:
\begin{equation}\label{guigjgjffghDDint}
\begin{cases}
\vec D:=-\nabla_{\vec x}\Psi-\frac{1}{c}\frac{\partial\vec
A}{\partial t}+\frac{1}{c}\vec
v\times curl_{\vec x}\vec A\\
\vec B:=curl_{\vec x}\vec A
\\
\vec E:=-\nabla_{\vec x}\Psi-\frac{1}{c}\frac{\partial\vec A}{\partial t}\\
\vec H:=curl_{\vec x}\vec A+\frac{1}{c}\vec
v\times\left(-\nabla_{\vec x}\Psi-\frac{1}{c}\frac{\partial\vec
A}{\partial t}+\frac{1}{c}\vec v\times curl_{\vec x}\vec A\right).
\end{cases}
\end{equation}
In particular, using continuum equation $\partial_t\mu+div_{\vec
x}\left(\mu\vec u\right)=0$ from the last equality in
\er{guigjgjffghguygjyfDDint} we deduce
\begin{equation*}
\frac{\partial}{\partial t}\left(\frac{1}{4\pi G}\Delta_{\vec
x}\Phi-\mu\right)+div_{\vec x}\left\{\left(\frac{1}{4\pi
G}\Delta_{\vec x}\Phi-\mu\right)\vec v\right\}=-div_{\vec
x}\left\{\frac{1}{4\pi c}\vec D\times\vec B\right\}.
\end{equation*}
Thus denoting $Q=\Delta_{\vec x}\Phi/4\pi G-\mu$ we deduce the
following system of equation for the gravitational-electromagnetic
field, invariant under the change of non-inertial cartesian
coordinate system:
\begin{equation}\label{guigjgjffghguygjyfghggDDint}
\begin{cases}
curl_{\vec x}\vec H=\frac{4\pi}{c}\vec
j+\frac{1}{c}\frac{\partial\vec D}{\partial
t}\\
div_{\vec x}\vec D=4\pi\rho\\
curl_{\vec x}\vec E+\frac{1}{c}\frac{\partial\vec B}{\partial t}=0\\
div_{\vec x}\vec B=0\\
\vec E=\vec D-\frac{1}{c}\vec v\times\vec B\\
\vec H=\vec B+\frac{1}{c}\vec v\times\vec D\\
curl_{\vec x}\left(curl_{\vec x}\vec v\right)=0
\\
\frac{\partial}{\partial t}\left(div_{\vec x}\vec v\right)+div_{\vec
x}\left\{\left(div_{\vec x}\vec v\right)\vec
v\right\}+\frac{1}{4}\left|d_{\vec x}\vec v+\{d_{\vec x}\vec
v\}^T\right|^2-\left(div_{\vec x}\vec v\right)^2=  -4\pi G(\mu+Q)
\\
\frac{\partial Q}{\partial t}+div_{\vec x}\left(Q\vec
v\right)=-div_{\vec x}\left\{\frac{1}{4\pi c}\vec D\times\vec
B\right\},
\end{cases}
\end{equation}
which is consistent with \er{MaxVacFull1bjkgjhjhgjaaaint},
\er{MaxVacFull1ninshtrgravortghhghgjkgghklhjgkghghjjkjhjkkggjkhjkhjjhhfhjhklkhkhjjklzzzyyyNWBWHWPPNint}
and
\er{MaxVacFull1ninshtrgravortghhghgjkgghklhjgkghghjjkjhjkkggjkhjkhjjhhfhjhklkhkhjjklzzzyyyNWNWBWHWPPNint}.

\subsection{Transformations of general scalar and vector fields under the
change of cartesian coordinate system} In order to get the above
results we established some trivial calculus consequences about the
behavior of scalar, vector and matrix fields, under the change of
cartesian coordinate system of the form
\er{noninchgravortbstrjgghguittu2int}. We combine them in the form
of Proposition after the following definition:
\begin{definition}\label{bggghghgjint}
Consider the change of some inertial or non-inertial cartesian
coordinate system $(*)$ to another cartesian coordinate system
$(**)$ of the form \er{noninchgravortbstrjgghguittu2int} where
$A(t)\in SO(3)$ is a rotation.
\begin{itemize}
\item
We say that a general scalar field $\psi:=\psi(\vec
x,t):\R^3\times[t_0,+\infty)\to\R$ is a proper scalar field if,
under every change of coordinate system given by
\er{noninchgravortbstrjgghguittu2int}, this field transforms by the
law:
\begin{equation}\label{uguyytfddddint}
\psi'(\vec x',t')=\psi(\vec x,t).
\end{equation}

\item
We say that a general vector field $\vec f:=\vec f(\vec
x,t):\R^3\times[t_0,+\infty)\to\R^3$ is a proper vector field if,
under every change of coordinate system given by
\er{noninchgravortbstrjgghguittu2int}, this field transforms by the
law:
\begin{equation}\label{uguyytfddddgghjjgint}
\vec f'(\vec x',t')=A(t)\cdot\vec f(\vec x,t),
\end{equation}

\item
We say that a general vector field $\vec v:=\vec v(\vec
x,t):\R^3\times[t_0,+\infty)\to\R^3$ is a speed-like vector field
if, under every change of coordinate system given by
\er{noninchgravortbstrjgghguittu2int}, this field transforms by the
law:
\begin{equation}
\label{NoIn5redbstrint}\vec v'(\vec x',t')=A(t)\cdot \vec v(\vec
x,t)+
\frac{d A}{dt}(t)\cdot\vec x+
\vec w(t),
\end{equation}
where we set
\begin{equation}\label{buitguihjkint}
\vec w(t):=\frac{d\vec z}{dt}(t)\quad\quad\forall\,t.
\end{equation}

\item
We say that a general matrix valued field $T:=T(\vec
x,t):\R^3\times[t_0,+\infty)\to\R^{3\times 3}$ is a proper matrix
field if, under every change of coordinate system given by
\er{noninchgravortbstrjgghguittu2int}, this field transforms by the
law:
\begin{equation}\label{uguyytfddddgghjjghjjjint}
T'(\vec x',t')=A(t)\cdot T(\vec x,t)\cdot A^T(t)=A(t)\cdot T(\vec
x,t)\cdot \left\{A(t)\right\}^{-1}.
\end{equation}
\end{itemize}
\end{definition}
\begin{proposition}\label{yghgjtgyrtrtint}
If $\psi:\R^3\times[t_0,+\infty)\to\R$ is a proper scalar field,
$\vec f:\R^3\times[t_0,+\infty)\to\R^3$ and $\vec
g:\R^3\times[t_0,+\infty)\to\R^3$ are proper vector fields, $\vec
v:\R^3\times[t_0,+\infty)\to\R^3$ and $\vec
u:\R^3\times[t_0,+\infty)\to\R^3$ are speed-like vector fields and
$T:\R^3\times[t_0,+\infty)\to\R^{3\times 3}$ is a proper matrix
field, then:
\begin{itemize}
\item[{\bf(i)}] scalar fields defined in every coordinate system as $\,\vec f\cdot\vec g$, $div_{\vec x} \vec f$ and $div_{\vec x} \vec
v$ are proper scalar fields;

\item[{\bf(ii)}] vector fields defined in every coordinate system as $\nabla_{\vec x}\psi$, $div_{\vec x} T$, $curl_{\vec x}\vec
f$, $\vec f\times\vec g$,
$div_{\vec x}\left(d_{\vec x} \vec v+\left\{d_{\vec x} \vec
v\right\}^T\right)$, $\nabla_{\vec x}\left(div_{\vec x}\vec
v\right)$, $\Delta_{\vec x}\vec v$, $curl_{\vec x}\left(curl_{\vec
x}\vec v\right)$ and $(\vec u-\vec v)$ are proper vector fields;

\item[{\bf(iii)}] matrix fields defined in every coordinate system as $\,d_{\vec x} \vec f$ and $\left(d_{\vec x} \vec v+\left\{d_{\vec x} \vec v\right\}^T\right)$ are proper
matrix fields;

\item[{\bf(iv)}] scalar fields $\xi:\R^3\times[t_0,+\infty)\to\R$ and $\zeta:\R^3\times[t_0,+\infty)\to\R$, defined
in every coordinate system by
\begin{equation}\label{vfyutuyfffhhgfhgfhgtgint}
\xi:=\frac{\partial\psi}{\partial t}+\vec v\cdot\nabla_{\vec x}\vec
\psi\quad \text{and}\quad \zeta:=\frac{\partial\psi}{\partial
t}+div_{\vec x}\left\{ \psi\vec v\right\}
\end{equation}
are proper scalar fields;

\item[{\bf(v)}] vector fields $\vec\Theta:\R^3\times[t_0,+\infty)\to\R^3$ and $\vec\Xi:\R^3\times[t_0,+\infty)\to\R^3$, defined
in every coordinate system by
\begin{equation}\label{vfyutuyfffhhgfhgfhint}
\vec\Theta:=\frac{\partial \vec f}{\partial t}- curl_{\vec
x}\left(\vec v\times \vec f\right)+\left({div}_{\vec x}\vec
f\right)\vec v\quad\text{and}\quad \vec\Xi:=\frac{\partial \vec
f}{\partial t}- \vec v\times curl_{\vec x}\vec f+\nabla_{\vec
x}\left(\vec v\cdot\vec f\right),
\end{equation}
are proper vector fields and
\begin{equation}
\label{vhfffngghhjghhgjlkhjhkPPPint} \vec\Xi
=
\vec\Theta-\left(div_{\vec x}\vec v\right)\vec f+ \left(d_{\vec
x}\vec v+\left\{d_{\vec x}\vec v\right\}^T\right)\cdot\vec f.
\end{equation}
\end{itemize}
\end{proposition}
We prove this Proposition in section \ref{fhfgfghdfdfdfd}.

Next, we also prove the following Proposition in section
\ref{fhfgfghdfdfdfd}.
\begin{proposition}\label{yghgjtgyrtrgjgjgjgjgjgjgjgjgjgjgjgjgj33int}
Consider some fixed inertial or non-inertial cartesian coordinate
system, denoted by the sign $(\{0\})$, and consider a vector field
$\vec d:=\vec d(\vec x,t):\R^3\times[t_0,+\infty)\to\R^3$ associated
\underline{only} with the chosen coordinate system $(\{0\})$. Then
there exist a unique \underline{speed-like} vector field $\vec
v_{\vec d}:\R^3\times[t_0,+\infty)\to\R^{3}$ (associated with any
inertial or non-inertial cartesian coordinate system) such that in
the \underline{particular} system $(\{0\})$ we have:
\begin{equation}\label{vfyutuyfffhhgfhgfh33iy33int}
\vec v_{\vec d}(\vec x,t)=\vec d(\vec x,t)\quad\quad\forall(\vec
x,t) \in\R^3\times[t_0,+\infty)\,.
\end{equation}
Moreover, there exist unique a \underline{proper} vector field $\vec
h_{\vec d}:\R^3\times[t_0,+\infty)\to\R^3$ (associated with any
inertial or non-inertial cartesian coordinate system) such that in
the \underline{particular} system $(\{0\})$ we have:
\begin{equation}\label{vfyutuyfffhhgfhgfh33iyjkkj33int}
\vec h_{\vec d}(\vec x,t)=\vec d(\vec x,t)\quad\quad\forall(\vec
x,t) \in\R^3\times[t_0,+\infty)\,.
\end{equation}
\end{proposition}

We also will need the following Definitions and Propositions and we
prove these Propositions in section \ref{fhfgfghdfdfdfd}.
\begin{definition}\label{jhjkhhjhjhgjgint}
Let $\vec u:=\vec u(\vec x,t)$ be a speed-like vector field defined
for $(\vec x,t)\in\Omega$ with domain of definition
$\Omega\subset\mathbb{R}^3\times\mathbb{R}$ and let $(*)$ be some
inertial or non-inertial cartesian coordinate system. Then $\vec u$
is called \underline{generally trivial speed-like field} if there
exists another cartesian coordinate system $(**)$ such that under
the change of coordinate system $(*)$ to another cartesian
coordinate system $(**)$ given by:
\begin{equation}\label{noninchgravortbstrjgghguittu2hhhhhint}
\begin{cases}
\vec x'=A(t)\cdot\vec x+\vec z(t),\\
t'=t,
\end{cases}
\end{equation}
where $A(t)\in SO(3)$ is a rotation, we have
\begin{equation}
\label{jljjjjkhhhint}A(t)\cdot \vec u(\vec x,t)+
\frac{d A}{dt}(t)\cdot\vec x+
\frac{d\vec z}{dt}(t)=\vec u'(\vec x',t')=0.
\end{equation}
\end{definition}
\begin{proposition}
\label{jhjkhhjhjhgjgjjkjkint} Let $\vec u:=\vec u(\vec x,t)$ be a
smooth speed-like vector field defined for $(\vec x,t)\in\Omega$
with connected domain $\Omega\subset\mathbb{R}^3\times\mathbb{R}$,
such that for every instant of time $\tau$ the domain
$\Omega_\tau=\left\{\vec x\in \mathbb{R}^3:\,(\vec
x,\tau)\in\Omega\right\}$ is also connected. Then $\vec u$ is a
\underline{generally trivial speed-like field} if and only if we
have
\begin{equation}
\label{jljjjjkhhhjhjhjhint} d_{\vec x}\vec u+\left\{d_{\vec x}\vec
u\right\}^T=0\quad\quad\forall (\vec x,t)\in\Omega.
\end{equation}
\end{proposition}
\begin{proposition}
\label{jhjkhhjhjhgjgjjkjkjhhjhint} Let $\vec v:=\vec v(\vec
x,t):\mathbb{R}^3\times[t_0,+\infty)\to\mathbb{R}^3$ be a speed-like
vector field, such that there exists some cartesian coordinate
system ${(*)}$, where we have:
\begin{equation}
\label{jljjjjkhhhjhjhjhljjjljljouukhhint} \lim\limits_{|\vec
x|\to+\infty}\vec v(\vec x,t)=0\quad\quad\forall t\quad\quad\text{(
in the given particular coordinate system ${(*)}$)}\,.
\end{equation}
Then there exist a uniquely defined \underline{generally trivial
speed-like} vector field $\,\vec k:=\vec k(\vec
x,t):\mathbb{R}^3\times[t_0,+\infty)\to\mathbb{R}^3$ and a uniquely
defined \underline{proper} vector field $\,\vec h:=\vec h(\vec
x,t):\mathbb{R}^3\times[t_0,+\infty)\to\mathbb{R}^3$, so that in
every cartesian coordinate system we have
\begin{equation}
\label{jljjjjkhhhjhjhjhljjjljljouukhhjhjgint} \lim\limits_{|\vec
x|\to+\infty}\vec h(\vec x,t)=0\quad\quad\forall t\,,
\end{equation}
and in every cartesian coordinate system we can decompose vector
field $\vec v$ as:
\begin{equation}
\label{jljjjjkhhhjhjhjhljjjljljkjkint} \vec v(\vec x,t)=\vec h(\vec
x,t)+\vec k(\vec x,t)\quad\quad\quad\quad\forall (\vec
x,t)\in\mathbb{R}^3\times[t_0,+\infty).
\end{equation}
\end{proposition}

\begin{definition}\label{jhjkhhjhjhgjggjkkljkljklkhoiuiiyint}
We say that two (inertial or non-inertial) cartesian coordinate
systems $(*)$ and $(**)$ are \underline{equivalent} if the change of
coordinates from the system $(*)$ to the system $(**)$ is given by
\begin{equation}\label{noninchgravortbstrjkjkjggg33jklkhjhjhuiuiint}
\begin{cases}
\vec x'=B\cdot\vec x+\vec c,\\
t'=t,
\end{cases}
\end{equation}
where $B\in SO(3)$ is a constant (independent of time) rotation and
$\vec c\in \mathbb{R}^3$ is a constant (independent of time) vector.
In other words, we obtain system $(**)$ from system $(*)$ just by a
constant (independent of time) rotation in $\mathbb{R}^3$ and/or by
a constant (independent of time) shift of the origin in
$\mathbb{R}^3$.
\end{definition}

\begin{definition}\label{jhjkhhjhjhgjg;kkljklint}
Let $\vec v:=\vec v(\vec x,t):\mathbb{R}^3\times
[t_0,+\infty)\to\mathbb{R}^3$ be a speed-like smooth vector field
and let $(*)$ be some (inertial or non-inertial) cartesian
coordinate system. We say that $\vec v$ is \underline{asymptotically
acceptable} in the coordinate system $(*)$, if there exist
continuous mappings $B(t):[t_0,+\infty)\to\mathbb{R}^{3\times 3}$
and $\vec l(t):[t_0,+\infty)\to\mathbb{R}^{3}$, such that in the
coordinate system $(*)$ we have
\begin{equation}
\label{jljjjjkhhhhkhhjhjkjkjklint}
\begin{cases}
\lim\limits_{|\vec x|\to+\infty}\left(d_{\vec x}\vec v(\vec
x,t)\right)
=B(t)\quad\quad\quad\forall t\in [t_0,+\infty)\,,\\
\lim\limits_{|\vec x|\to+\infty}\left(d_{\vec x}\vec v(\vec
x,t)+\left\{d_{\vec x}\vec v(\vec
x,t)\right\}^T\right)=0\quad\quad\quad\forall t\in [t_0,+\infty)\,,
\\
\lim\limits_{|\vec x|\to+\infty}\left(\vec v(\vec x,t)-B(t)\cdot\vec
x\right)=\vec l(t)\quad\quad\quad\forall t\in [t_0,+\infty)\,.
\end{cases}
\end{equation}
\end{definition}
\begin{proposition}
\label{jhjkhhjhjhgjgjjkjkjklljjint} Let $\vec v:=\vec v(\vec
x,t):\mathbb{R}^3\times [t_0,+\infty)\to\mathbb{R}^3$ be a
speed-like smooth vector field. Then the following three conditions
are \underline{equivalent}:
\begin{itemize}
\item[{\bf (i)}]
There exists a (inertial or non-inertial) cartesian coordinate
system $(*)$  where the vector field $\vec v$ is asymptotically
acceptable.
\item[{\bf (ii)}]
The vector field $\vec v$ is asymptotically acceptable in
\underline{every} (inertial or non-inertial) cartesian coordinate
system.
\item[{\bf (iii)}]
There exists a (inertial or non-inertial) cartesian coordinate
system ($\{0\}$) in which we have \begin{align}
\label{jljjjjkhhhhkhhjhjkjkjkljojkjkjkhj1int} \lim\limits_{|\vec
x|\to+\infty}\left(d_{\vec x}\vec v(\vec x,t)\right)
=0\quad\quad\quad\forall t\in [t_0,+\infty)\,,\\
\label{jljjjjkhhhhkhhjhjkjkjkljojkjkjkhj2int} \lim\limits_{|\vec
x|\to+\infty}\vec v(\vec x,t)=0\quad\quad\quad\forall t\in
[t_0,+\infty)\,.
\end{align}
Moreover, in that case the cartesian coordinate system
where \er{jljjjjkhhhhkhhjhjkjkjkljojkjkjkhj2int} holds, is unique,
up to an equivalence, in other words it is unique, up to a constant
(independent of time) rotation in $\mathbb{R}^3$ and/or up to a
constant (independent of time) shift of the origin in
$\mathbb{R}^3$.
\end{itemize}
\end{proposition}
As a direct consequence of both Propositions
\ref{jhjkhhjhjhgjgjjkjkjhhjhint} and
\ref{jhjkhhjhjhgjgjjkjkjklljjint} we deduce the following Corollary:
\begin{corollary}
\label{jhjkhhjhjhgjgjjkjkjhhjhCCint} Let $\vec v:=\vec v(\vec
x,t):\mathbb{R}^3\times[t_0,+\infty)\to\mathbb{R}^3$ be an
\underline{asymptotically acceptable} speed-like vector field. Then
there exist a uniquely defined \underline{generally trivial
speed-like} vector field $\,\vec k:=\vec k(\vec
x,t):\mathbb{R}^3\times[t_0,+\infty)\to\mathbb{R}^3$ and a uniquely
defined \underline{proper} vector field $\,\vec h:=\vec h(\vec
x,t):\mathbb{R}^3\times[t_0,+\infty)\to\mathbb{R}^3$, so that in
every cartesian coordinate system we have
\begin{equation}
\label{jljjjjkhhhjhjhjhljjjljljouukhhjhjgCCint} \lim\limits_{|\vec
x|\to+\infty}\vec h(\vec x,t)=0\quad\quad\forall t\,,
\end{equation}
and in every cartesian coordinate system we can decompose vector
field $\vec v$ as:
\begin{equation}
\label{jljjjjkhhhjhjhjhljjjljljkjkCCint} \vec v(\vec x,t)=\vec
h(\vec x,t)+\vec k(\vec x,t)\quad\quad\quad\quad\forall (\vec
x,t)\in\mathbb{R}^3\times[t_0,+\infty).
\end{equation}
Moreover, in addition, the proper vector field $\,\vec h$
necessarily satisfies
\begin{equation}
\label{jljjjjkhhhjhjhjhljjjljljouukhhjhjgCCkhjkjhCCint}
\lim\limits_{|\vec x|\to+\infty}\left(d_{\vec x}\vec h(\vec
x,t)\right)=0\quad\quad\forall t\,.
\end{equation}
\end{corollary}

\begin{definition}\label{gjgjffjjjjjjjjjjjjjjjjjjjjjjhhint}
Let $\mathcal{U}$ be a nonempty subset of the set of all inertial
and non-inertial cartesian coordinate systems. We say that
$\mathcal{U}$ is an \underline{extended Galilean group} if, given
arbitrary cartesian coordinate system $(*)$ inside the subset
$\mathcal{U}$, the following statement holds:
\begin{itemize}
\item
The arbitrary cartesian coordinate system $(**)$ belongs to the
subset $\mathcal{U}$ if and only if there exist a constant
(independent of time) rotation $B\in SO(3)$ and constant
(independent of time) vectors $\vec c\in \mathbb{R}^3$ and $\vec
w\in \mathbb{R}^3$, so that the change of coordinates from the
system $(*)$ to the system $(**)$ is given by
\begin{equation}\label{noninchgravortbstrjkjkjggg33jklkhjhjhuiuiljjljjkklint}
\begin{cases}
\vec x'=B\cdot\vec x+\vec c+\vec w t,\\
t'=t.
\end{cases}
\end{equation}
\end{itemize}
In other words, the system $(**)$ belongs to $\mathcal{U}$ if and
only if, up to equivalence of cartesian coordinate systems, the
system $(**)$ can be obtained from the system $(*)$ by the Galilean
Transformation
\begin{equation}\label{noninchgravortbstrjgghguittu1hjmjhint}
\begin{cases}
\vec x'=\vec x+\vec wt,\\
t'=t.
\end{cases}
\end{equation}
\end{definition}
\begin{definition}\label{gjgjffjjjjjjjjjjjjjjjjjjjjjjhhrint} Let
$\mathcal{V}$ be a nonempty subset of the set of all inertial and
non-inertial cartesian coordinate systems. We say that $\mathcal{V}$
is a \underline{non-rotational rigid motion group} if, given
arbitrary cartesian coordinate system $(*)$ inside the subset
$\mathcal{V}$, the following statement holds:
\begin{itemize}
\item
The arbitrary cartesian coordinate system $(**)$ belongs to the
subset $\mathcal{V}$ if and only if there exist a constant
(independent of time) rotation $B\in SO(3)$ and a smooth
vector-valued function of the time variable $\vec z(t):\mathbb{R}\to
\mathbb{R}^3$, so that the change of coordinates from the system
$(*)$ to the system $(**)$ is given by
\begin{equation}\label{noninchgravortbstrjkjkjggg33jklkhjhjhuiuiljjljjkklrint}
\begin{cases}
\vec x'=B\cdot\vec x+\vec z(t),\\
t'=t.
\end{cases}
\end{equation}
\end{itemize}
\end{definition}
\begin{definition}\label{gjgjffjjjjjjjjjjjjjjjjjjjjjjhhrfint}
Let $\mathcal{W}$ be a certain nonempty subset of coordinate
systems. We say that $\mathcal{W}$ is a \underline{general rigid
motion group} if, given arbitrary coordinate system $(*)$ inside the
subset $\mathcal{W}$, the following statement holds:
\begin{itemize}
\item
The arbitrary coordinate system $(**)$ belongs to the subset
$\mathcal{W}$ if and only if there exist a smooth (time dependent)
rotation $A(t):\mathbb{R}\to SO(3)$ and a smooth vector-valued
function of the time variable $\vec z(t):\mathbb{R}\to
\mathbb{R}^3$, so that the change of coordinates from the system
$(*)$ to the system $(**)$ is given by
\begin{equation}\label{noninchgravortbstrjkjkjggg33jklkhjhjhuiuiljjljjkklrfint}
\begin{cases}
\vec x'=A(t)\cdot\vec x+\vec z(t),\\
t'=t.
\end{cases}
\end{equation}
\end{itemize}
In particular, the set of all \underline{cartesian} coordinate
systems is a general rigid motion group.
\end{definition}

\begin{definition}\label{jhjkhhjhjhgjg;kkljkljkkjlhlhhlhint}
Let $\vec v:=\vec v(\vec x,t):\mathbb{R}^3\times
[t_0,+\infty)\to\mathbb{R}^3$ be an \underline{asymptotically
acceptable} speed-like smooth vector field (see Definition
\ref{jhjkhhjhjhgjg;kkljklint}).
\begin{itemize}
\item
We call that the cartesian coordinate system ($\{0\}$) is
\underline{preferable} for the vector field $\vec v$ if in the
system ($\{0\}$) we have
\begin{align} \label{jljjjjkhhhhkhhjhjkjkjkljojkjkjkhj1hhint}
\lim\limits_{|\vec x|\to+\infty}\left(d_{\vec x}\vec v(\vec
x,t)\right)
=0\quad\quad\quad\forall t\in [t_0,+\infty)\,,\\
\label{jljjjjkhhhhkhhjhjkjkjkljojkjkjkhj2hhint} \lim\limits_{|\vec
x|\to+\infty}\vec v(\vec x,t)=0\quad\quad\quad\forall t\in
[t_0,+\infty)\,.
\end{align}
Note that by Proposition \ref{jhjkhhjhjhgjgjjkjkjklljjint} the
unique (up to equivalence) preferable for the vector field $\vec v$
cartesian system ($\{0\}$) exists.
\item
We call that the cartesian coordinate system $(*)$ is
\underline{inertial} with respect to the vector field $\vec v$ if in
the system $(*)$ we have
\begin{align} \label{jljjjjkhhhhkhhjhjkjkjkljojkjkjkhj1hhjint}
\lim\limits_{|\vec x|\to+\infty}\left(d_{\vec x}\vec v(\vec
x,t)\right)
=0\quad\quad\quad\forall t\in [t_0,+\infty)\,,\\
\label{jljjjjkhhhhkhhjhjkjkjkljojkjkjkhj2hhjint} \lim\limits_{|\vec
x|\to+\infty}\vec v(\vec x,t)=\vec d\quad\quad\quad\forall t\in
[t_0,+\infty)\,,
\end{align}
where $\vec d\in \mathbb{R}^3$ is a constant (independent of $t$)
vector. Note that, in particular, by the definition, the preferable
for $\vec v$ cartesian system is also an inertial with respect to
$\vec v$  coordinate system.
\item
We call that the cartesian coordinate system $(*)$ is
\underline{non-rotating} with respect to the vector field $\vec v$
if in the system $(*)$ we have
\begin{equation} \label{jljjjjkhhhhkhhjhjkjkjkljojkjkjkhj1hhjhkkjint}
\lim\limits_{|\vec x|\to+\infty}\left(d_{\vec x}\vec v(\vec
x,t)\right)=0\quad\quad\quad\forall t\in [t_0,+\infty)\,,
\end{equation}
(without specifying $\lim\limits_{|\vec x|\to+\infty}\vec v(\vec
x,t)$). Note that, in particular, by the definition, every inertial
with respect to $\vec v$ cartesian coordinate system is also a
non-rotating with respect to $\vec v$ coordinate system.
\end{itemize}
\end{definition}
\begin{proposition}\label{jghfgjdhdhgint}
Let $\vec v:=\vec v(\vec x,t):\mathbb{R}^3\times
[t_0,+\infty)\to\mathbb{R}^3$ be an \underline{asymptotically
acceptable} speed-like smooth vector field (see Definition
\ref{jhjkhhjhjhgjg;kkljklint}). Furthermore, let $\mathcal{U}$ be
the set of all \underline{inertial} with respect to $\vec v$
cartesian coordinate systems and let $\mathcal{V}$ be the set of all
\underline{non-rotating} with respect to $\vec v$ cartesian
coordinate systems (see Definition
\ref{jhjkhhjhjhgjg;kkljkljkkjlhlhhlhint}). Then $\mathcal{U}$ is an
\underline{extended Galilean group} and  $\mathcal{V}$ is a
\underline{non-rotational rigid motion group} (see Definitions
\ref{gjgjffjjjjjjjjjjjjjjjjjjjjjjhhint} and
\ref{gjgjffjjjjjjjjjjjjjjjjjjjjjjhhrint}).
\end{proposition}

The next Proposition describes dynamics of the line, the surface and
the volume integrals.
\begin{proposition}\label{jgjjjjjjjjjjjjjjjjjjjjjjjjjjjjjjjjjkhhkhint}
\begin{itemize}
\item[{{\bf (i)}}]
Consider the moving continuum medium with velocity field $\vec
u(\vec x,t)$ and let
$\mathcal{S}:=\mathcal{S}(t)\subset\mathbb{R}^3$ be a
two-dimensional surface oriented by the unit normal $\vec n:=\vec
n(\vec x,t)$ and moving together with the given medium. Then, given
a vector field $\vec f(\vec x,t)$ we have:
\begin{equation}\label{MaxVacFullPPNffGGvbccgcu8uuuuukjjkhhhihjhjkhhkmgjhjhhkhjhhjjhjhjjh11ljkjkljkljjjkjkljkjkint}
\frac{d}{dt}\left(\iint\vec f\cdot\vec n\,
d\mathcal{S}(t)\right)=\iint\left(\frac{\partial \vec f}{\partial
t}- curl_{\vec x}\left(\vec u\times \vec f\right)+\left({div}_{\vec
x}\vec f\right)\vec u\right)\cdot\vec n\, d\mathcal{S}(t).
\end{equation}

\item[{{\bf (ii)}}]
Consider the moving continuum medium with velocity field $\vec
u(\vec x,t)$ and let $\gamma:=\gamma(t)\subset\mathbb{R}^3$ be a
one-dimensional curve oriented by the unit tangent vector $\vec
t:=\vec t(\vec x,t)$ and moving together with the given medium.
Then, given a vector field $\vec f(\vec x,t)$ we have:
\begin{equation}\label{MaxVacFullPPNffGGvbccgcu8uuuuukjjkhhhihjhjkhhkmgjhjhhkhjhhjjhjhjjh11ljkjkljkljjjkjkljkjk11int}
\frac{d}{dt}\left(\int\vec f\cdot\vec t\,
d\gamma(t)\right)=\int\left(\frac{\partial \vec f}{\partial t}- \vec
u\times curl_{\vec x}\vec f+\nabla_{\vec x}\left(\vec u\cdot\vec
f\right)\right)\cdot\vec t\, d\gamma(t).
\end{equation}

\item[{{\bf (iii)}}]
Consider the moving continuum medium with velocity field $\vec
u(\vec x,t)$ and let $\Omega:=\Omega(t)\subset\mathbb{R}^3$ be a
three-dimensional domain moving together with the given medium.
Then, given a scalar field $\psi(\vec x,t)$ we have:
\begin{equation}\label{MaxVacFullPPNffGGvbccgcu8uuuuukjjkhhhihjhjkhhkmgjhjhhkhjhhjjhjhjjh11ljkjkljkljjjkjkljkjk22int}
\frac{d}{dt}\left(\iiint\psi\,
d\Omega(t)\right)=\iiint\left(\frac{\partial \psi}{\partial
t}+{div}_{\vec x}\left\{\psi \vec u\right\}\right)\, d\Omega(t).
\end{equation}
\end{itemize}
\end{proposition}
We prove this Proposition in section \ref{fhfgfghdfdfdfd}. Note,
however, that the result similar to Proposition
\ref{jgjjjjjjjjjjjjjjjjjjjjjjjjjjjjjjjjjkhhkhint} was obtained
before by Cornille in section 19.2 of \cite{PC}.

\subsection{Genuine gravity and inertia}\label{jlhkgkhfgjdfhdint}
\begin{definition}\label{jjhgyftdtdint}
In general, both in the case of the simplest model of the Newtonian
gravity and in the case of any other alternative model of gravity,
we \underline{always} assume that the vectorial gravitational
potential $\vec v:=\vec v(\vec x,t)$ is an asymptotically acceptable
speed-like vector field (see Definition
\ref{jhjkhhjhjhgjg;kkljklint} and Proposition
\ref{jhjkhhjhjhgjgjjkjkjklljjint}). Therefore, Corollary
\ref{jhjkhhjhjhgjgjjkjkjhhjhCCint} implies that there exist a
uniquely defined \underline{generally trivial speed-like} vector
field $\,\vec k:=\vec k(\vec x,t)\in\mathbb{R}^3$ (see Definition
\ref{jhjkhhjhjhgjgint}) and a uniquely defined \underline{proper}
vector field $\,\vec h:=\vec h(\vec x,t)\in\mathbb{R}^3$, so that in
every cartesian coordinate system we have
\begin{equation}
\label{jljjjjkhhhjhjhjhljjjljljouukhhjhjgCChhhhint}
\lim\limits_{|\vec x|\to+\infty}\vec h(\vec x,t)=0\quad\quad\forall
t\quad\quad\text{and}\quad\quad\lim\limits_{|\vec
x|\to+\infty}\left(d_{\vec x}\vec h(\vec
x,t)\right)=0\quad\quad\forall t\,,
\end{equation}
and in every cartesian coordinate system we can decompose vector
field $\vec v$ as:
\begin{equation}
\label{jljjjjkhhhjhjhjhljjjljljkjkCChhhhint} \vec v(\vec x,t)=\vec
h(\vec x,t)+\vec k(\vec x,t)\quad\quad\quad\quad\forall (\vec x,t).
\end{equation}
Then we call $\vec h$ \underline{vectorial potential of genuine
gravity} and $\vec k$ \underline{vectorial potential of inertia}. In
particular, it is clear that, given cartesian coordinate system
$(*)$, this system will be \underline{inertial} if and only if the
vectorial potential of inertia $\vec k$ is a \underline{constant}
(independent of $(\vec x,t)$) in the coordinate system $(*)$ (see
subsection \ref{jlhkgkhfgjdfhd} for the details).
\end{definition}
The name of genuine gravity and inertia is clarified by the fact
that since $\vec k$ is \underline{generally trivial} speed-like
vector field, by the definition, we can completely eliminate $\vec
k$ (getting $\vec k=0$) by the simple change of cartesian coordinate
system (it is fictitious). On the other hand, if $\vec h$ is
nontrivial, since $\vec h$ is a \underline{proper} vector field, we
cannot make it trivial in any other coordinate system (it is
genuine). Moreover, the vector of inertia $\vec k$ depends only on
the coordinate system in the space and it is completely independent
on the physical matter or physical fields filling this space. In
contrast vectorial potential of genuine gravity $\vec h$ depends
essentially on the surrounding physical matter (in the model of the
Newtonian gravity through gravitational masses).

Next, assume the model of Newtonian gravity, given by
\er{MaxVacFull1ninshtrgravortghhghgjkgghklhjgkghghjjkjhjkkggjkhjkhjjhhfhjhklkhkhjjklzzzyyyNWBWHWPPN222int},
and assume that $(*)$ is some inertial cartesian coordinate system.
Then, by
\er{MaxVacFull1ninshtrgravortghhghgjkgghklhjgkghghjjkjhjkkggjkhjkhjjhhfhjhklkhkhjjklzzzyyyhjggjhgghhjhNWNWBWHWPPN222kgghjghjghjint}
in the system $(*)$ we have:
\begin{equation}
\label{MaxVacFull1ninshtrgravortghhghgjkgghklhjgkghghjjkjhjkkggjkhjkhjjhhfhjhklkhkhjjklzzzyyyhjggjhgghhjhNWNWBWHWPPN222hhhhint}
\begin{cases}
curl_{\vec x}\vec v= 0,\\
\frac{\partial\vec v}{\partial t}+\frac{1}{2}\nabla_{\vec
x}\left(|\vec v|^2\right)= -\nabla_{\vec x}\Phi,
\end{cases}
\end{equation}
where $\Phi$ is a scalar Newtonian gravitational potential which is
assumed to be a proper scalar, which satisfies
\begin{equation}
\label{MaxVacFull1ninshtrgravortghhghgjkgghklhjgkghghjjkjhjkkggjkhjkhjjhhfhjhklkhkhjjklzzzyyyhjggjhgghhjhNWNWNWBWHWPPN222hhhhint}
\begin{cases}
\Delta_{\vec x}\Phi=4\pi GM\quad\quad\quad\forall \,(\vec x,t),\\
\lim\limits_{|\vec x|\to+\infty}\nabla_{\vec x}\Phi(\vec
x,t)=0\quad\quad\quad\forall t,
\end{cases}
\end{equation}
with $M$ being the gravitational mass density and $G$ being the
gravitational constant. Thus inserting
\er{jljjjjkhhhjhjhjhljjjljljkjkCChhhhint} into
\er{MaxVacFull1ninshtrgravortghhghgjkgghklhjgkghghjjkjhjkkggjkhjkhjjhhfhjhklkhkhjjklzzzyyyhjggjhgghhjhNWNWBWHWPPN222hhhhint}
and using the fact that in the inertial system $(*)$ the vector
$\vec k$ is a constant, we deduce
\begin{equation}
\label{MaxVacFull1ninshtrgravortghhghgjkgghklhjgkghghjjkjhjkkggjkhjkhjjhhfhjhklkhkhjjklzzzyyyhjggjhgghhjhNWNWBWHWPPN222hhhhkjhkhkjkhhjkkhhkjkljkkhkhint}
\begin{cases}
curl_{\vec x}\vec h= 0,\\
\frac{\partial\vec h}{\partial t}-\vec v\times curl_{\vec x}\vec
h+\nabla_{\vec x}\left(\vec v\cdot\vec h\right)-\nabla_{\vec
x}\left(\frac{1}{2}\left|\vec h\right|^2\right)=-\nabla_{\vec
x}\Phi,
\end{cases}
\end{equation}
that we can alternatively rewrite as:
\begin{equation}
\label{MaxVacFull1ninshtrgravortghhghgjkgghklhjgkghghjjkjhjkkggjkhjkhjjhhfhjhklkhkhjjklzzzyyyhjggjhgghhjhNWNWBWHWPPN222hhhhkjhkhkjkhhjkkhhkjkljkkhkh11int}
\begin{cases}
curl_{\vec x}\vec h= 0,\\
\frac{\partial\vec h}{\partial t}-\vec k\times curl_{\vec x}\vec
h+\nabla_{\vec x}\left(\vec k\cdot\vec h\right)+\nabla_{\vec
x}\left(\frac{1}{2}\left|\vec h\right|^2\right)=-\nabla_{\vec
x}\Phi.
\end{cases}
\end{equation}
However, since $\vec h$ is a proper vector field, by Proposition
\ref{yghgjtgyrtrtint} we deduce, that both
\er{MaxVacFull1ninshtrgravortghhghgjkgghklhjgkghghjjkjhjkkggjkhjkhjjhhfhjhklkhkhjjklzzzyyyhjggjhgghhjhNWNWBWHWPPN222hhhhkjhkhkjkhhjkkhhkjkljkkhkhint}
and
\er{MaxVacFull1ninshtrgravortghhghgjkgghklhjgkghghjjkjhjkkggjkhjkhjjhhfhjhklkhkhjjklzzzyyyhjggjhgghhjhNWNWBWHWPPN222hhhhkjhkhkjkhhjkkhhkjkljkkhkh11int}
preserve their form also in \underline{non-inertial} cartesian
coordinate systems, provided $\Phi$ is a proper scalar, which is
defined in \underline{every} inertial or non-inertial cartesian
coordinate system by the following:
\begin{equation}
\label{MaxVacFull1ninshtrgravortghhghgjkgghklhjgkghghjjkjhjkkggjkhjkhjjhhfhjhklkhkhjjklzzzyyyhjggjhgghhjhNWNWNWBWHWPPN222hhhhjkhkjhjhjhjklint}
\begin{cases}
\Delta_{\vec x}\Phi=4\pi GM\quad\quad\quad\forall \,(\vec x,t),\\
\lim\limits_{|\vec x|\to+\infty}\nabla_{\vec x}\Phi(\vec
x,t)=0\quad\quad\quad\forall t.
\end{cases}
\end{equation}
Finally, by Proposition \ref{jhjkhhjhjhgjgjjkjkint} vector field
$\vec k$ satisfies
\begin{equation}
\label{MaxVacFull1ninshtrgravortghhghgjkgghklhjgkghghjjkjhjkkggjkhjkhjjhhfhjhklkhkhjjklzzzyyyhjggjhgghhjhNWNWNWBWHWPPN222hhhhjkhkjhjhjhhkhhjhjkjhmjljklint}
d_{\vec x}\vec k+\left\{d_{\vec x}\vec
k\right\}^T=0\quad\quad\quad\forall \,(\vec x,t)\,,
\end{equation}
in \underline{every} inertial or non-inertial coordinate system.
Therefore the Newtonian gravity in the form
\er{MaxVacFull1ninshtrgravortghhghgjkgghklhjgkghghjjkjhjkkggjkhjkhjjhhfhjhklkhkhjjklzzzyyyNWBWHWPPN222int},
can be rewritten in the terms of genuine gravity and inertia as
\er{MaxVacFull1ninshtrgravortghhghgjkgghklhjgkghghjjkjhjkkggjkhjkhjjhhfhjhklkhkhjjklzzzyyyhjggjhgghhjhNWNWNWBWHWPPN222hhhhjkhkjhjhjhjklint}
and either
\er{MaxVacFull1ninshtrgravortghhghgjkgghklhjgkghghjjkjhjkkggjkhjkhjjhhfhjhklkhkhjjklzzzyyyhjggjhgghhjhNWNWBWHWPPN222hhhhkjhkhkjkhhjkkhhkjkljkkhkhint}
or, alternatively,
\er{MaxVacFull1ninshtrgravortghhghgjkgghklhjgkghghjjkjhjkkggjkhjkhjjhhfhjhklkhkhjjklzzzyyyhjggjhgghhjhNWNWBWHWPPN222hhhhkjhkhkjkhhjkkhhkjkljkkhkh11int},
that are complemented with
\er{MaxVacFull1ninshtrgravortghhghgjkgghklhjgkghghjjkjhjkkggjkhjkhjjhhfhjhklkhkhjjklzzzyyyhjggjhgghhjhNWNWNWBWHWPPN222hhhhjkhkjhjhjhhkhhjhjkjhmjljklint}
and \er{jljjjjkhhhjhjhjhljjjljljkjkCChhhhint}. Moreover, note again
that
\er{MaxVacFull1ninshtrgravortghhghgjkgghklhjgkghghjjkjhjkkggjkhjkhjjhhfhjhklkhkhjjklzzzyyyhjggjhgghhjhNWNWNWBWHWPPN222hhhhjkhkjhjhjhjklint},
\er{MaxVacFull1ninshtrgravortghhghgjkgghklhjgkghghjjkjhjkkggjkhjkhjjhhfhjhklkhkhjjklzzzyyyhjggjhgghhjhNWNWBWHWPPN222hhhhkjhkhkjkhhjkkhhkjkljkkhkhint},
\er{MaxVacFull1ninshtrgravortghhghgjkgghklhjgkghghjjkjhjkkggjkhjkhjjhhfhjhklkhkhjjklzzzyyyhjggjhgghhjhNWNWBWHWPPN222hhhhkjhkhkjkhhjkkhhkjkljkkhkh11int},
\er{MaxVacFull1ninshtrgravortghhghgjkgghklhjgkghghjjkjhjkkggjkhjkhjjhhfhjhklkhkhjjklzzzyyyhjggjhgghhjhNWNWNWBWHWPPN222hhhhjkhkjhjhjhhkhhjhjkjhmjljklint},
and \er{jljjjjkhhhjhjhjhljjjljljkjkCChhhhint} are invariant under
the change of inertial or non-inertial cartesian coordinate system.
%
%
%
%
%
%

Next note that, since we have $curl_{\vec x}\vec h=0$, we can write
either
\er{MaxVacFull1ninshtrgravortghhghgjkgghklhjgkghghjjkjhjkkggjkhjkhjjhhfhjhklkhkhjjklzzzyyyhjggjhgghhjhNWNWBWHWPPN222hhhhkjhkhkjkhhjkkhhkjkljkkhkhint}
or
\er{MaxVacFull1ninshtrgravortghhghgjkgghklhjgkghghjjkjhjkkggjkhjkhjjhhfhjhklkhkhjjklzzzyyyhjggjhgghhjhNWNWBWHWPPN222hhhhkjhkhkjkhhjkkhhkjkljkkhkh11int}
as the following Hamilton-Jacobi type equation:
\begin{equation}
\label{MaxVacFull1ninshtrgravortghhghgjkgghklhjgkghghjjkjhjkkggjkhjkhjjhhfhjhklkhkhjjklzzzyyyhjggjhgghhjhNWNWNWNWNWBWHWPPN222jujhkhhjhgjggint}
\begin{cases}
\vec h=\nabla_{\vec x}Y,\\
\frac{\partial Y}{\partial t}+\vec v\cdot\nabla_{\vec
x}Y-\frac{1}{2}\left|\nabla_{\vec x}Y\right|^2=\frac{\partial
Y}{\partial t}+\vec k\cdot\nabla_{\vec
x}Y+\frac{1}{2}\left|\nabla_{\vec x}Y\right|^2=-\Phi,
\end{cases}
\end{equation}
where $Y:=Y(\vec x,t)$ is some scalar field. Then, since $\vec h$ is
a proper vector field and $\Phi$ is proper scalar field,  by
Proposition \ref{yghgjtgyrtrtint} we deduce, that
\er{MaxVacFull1ninshtrgravortghhghgjkgghklhjgkghghjjkjhjkkggjkhjkhjjhhfhjhklkhkhjjklzzzyyyhjggjhgghhjhNWNWNWNWNWBWHWPPN222jujhkhhjhgjggint}
is invariant under the changes of inertial or non-inertial cartesian
coordinate systems, provided that we consider $Y$ to be a
\underline{proper} scalar field. Note here that, in contrast to the
fact that the quantity $Z$ and Hamilton-Jacobi equation
\er{MaxVacFull1ninshtrgravortghhghgjkgghklhjgkghghjjkjhjkkggjkhjkhjjhhfhjhklkhkhjjklzzzyyyhjggjhgghhjhNWNWNWNWNWBWHWPPN222int}
were defined only in inertial coordinate systems, the quantity $Y$
and Hamilton-Jacobi equation
\er{MaxVacFull1ninshtrgravortghhghgjkgghklhjgkghghjjkjhjkkggjkhjkhjjhhfhjhklkhkhjjklzzzyyyhjggjhgghhjhNWNWNWNWNWBWHWPPN222jujhkhhjhgjggint}
are defined in every inertial or non-inertial coordinate system.
Next in every \underline{inertial} cartesian coordinate system
define the scalar field $X:=X(\vec x,t)$ by
\begin{equation}
\label{MaxVacFull1ninshtrgravortghhghgjkgghklhjgkghghjjkjhjkkggjkhjkhjjhhfhjhklkhkhjjklzzzyyyhjggjhgghhjhNWNWNWNWNWBWHWPPN222jujhkhhjhgjggljkyiyhkhint}
X(\vec x,t)=Z(\vec x,t)-Y(\vec x,t)\quad\quad\forall(\vec x,t)\,.
\end{equation}
Then, by the first equation in
\er{MaxVacFull1ninshtrgravortghhghgjkgghklhjgkghghjjkjhjkkggjkhjkhjjhhfhjhklkhkhjjklzzzyyyhjggjhgghhjhNWNWNWNWNWBWHWPPN222int}
and the first equation in
\er{MaxVacFull1ninshtrgravortghhghgjkgghklhjgkghghjjkjhjkkggjkhjkhjjhhfhjhklkhkhjjklzzzyyyhjggjhgghhjhNWNWNWNWNWBWHWPPN222jujhkhhjhgjggint},
together with \er{jljjjjkhhhjhjhjhljjjljljkjkCChhhhint}, we deduce
\begin{equation}
\label{jljjjjkhhhjhjhjhljjjljljkjkCChhhhjjjkkhint} \vec k(\vec x,t)=
\nabla_{\vec x}X(\vec x,t)\quad\quad\quad\quad\forall (\vec x,t),
\end{equation}
in every inertial cartesian coordinate system.

\subsection{Lagrangian of the unified
Gravitational-Electromagnetic field in the case of some possible
alternative model of the gravity}\label{ghjfhgfdhdnw22mmint}
Consider $\vec k$ to be the vectorial potential of the inertia,
which is a generally trivial speed-like vector field, assumed to be
fixed in every fixed inertial or non-inertial cartesian coordinate
system (see Definition \ref{jjhgyftdtdint}). Given known the
distribution of inertial mass density of some continuum medium
$\mu:=\mu(\vec x,t)$, the field of velocities of this medium $\vec
u:=\vec u(\vec x,t)$, the charge density $\rho:=\rho(\vec x,t)$ and
the current density $\vec j:=\vec j(\vec x,t)$ consider a Lagrangian
density $L$ defined by
\begin{multline}\label{vhfffngghkjgghDDnw22mmint}
L\left(\vec A,\Psi,\vec v,\Phi_0,\vec h,\vec x,t\right):=\\
\frac{1}{8\pi}\left|-\nabla_{\vec
x}\Psi-\frac{1}{c}\frac{\partial\vec A}{\partial t}+\frac{1}{c}\vec
v\times curl_{\vec x}\vec A\right|^2-\frac{1}{8\pi}\left|curl_{\vec
x}\vec A\right|^2-\left(\rho\Psi-\frac{1}{c}\vec A\cdot\vec
j\right)+\frac{\mu}{2}\left|\vec u-\vec v\right|^2
\\
-\frac{c^2}{8\pi G}\left|-\nabla_{\vec x}\Phi_0
-\frac{1}{c}\frac{\partial\vec h}{\partial t}
+\frac{1}{c}\vec v\times curl_{\vec x}\vec h-\frac{1}{c}\nabla_{\vec
x}\left(\vec h\cdot\vec v\right)\right|^2+\frac{c^2}{8\pi
G}\left|curl_{\vec x}\vec h\right|^2\\+\frac{c^2\beta}{4\pi
G}\left(\frac{1}{4}\left|d_{\vec x}\vec v+\left\{d_{\vec x}\vec
v\right\}^T\right|^2-\left|\Div_{\vec x}\vec v\right|^2\right) ,
\end{multline}
where
\begin{equation}\label{btfffygtgyggyDDnw22jkhk22mmint}
\vec h=\vec v-\vec
k\quad\text{and}\quad\Phi_0=-\frac{1}{2c}\left|\vec v-\vec
k\right|^2,
\end{equation}
and $\beta\in \mathbb{R}$ is some constant. In other words we have:
\begin{multline}\label{vhfffngghkjgghDDnw22jkjkjkjk22gughhg22mmint}
L\left(\vec A,\Psi,\vec v,\Phi_0,\vec h,\vec
x,t\right)=L_1\left(\vec A,\Psi,\vec v,\vec
x,t\right):=\\
\frac{1}{8\pi}\left|-\nabla_{\vec
x}\Psi-\frac{1}{c}\frac{\partial\vec A}{\partial t}+\frac{1}{c}\vec
v\times curl_{\vec x}\vec A\right|^2-\frac{1}{8\pi}\left|curl_{\vec
x}\vec A\right|^2-\left(\rho\Psi-\frac{1}{c}\vec A\cdot\vec
j\right)+\frac{\mu}{2}\left|\vec u-\vec v\right|^2
\\
-\frac{c^2}{8\pi G}\left|
\frac{1}{c}\nabla_{\vec x}\left(\frac{1}{2}\left|\vec v-\vec
k\right|^2\right)-\frac{1}{c}\frac{\partial}{\partial t}\left(\vec
v-\vec k\right)
+\frac{1}{c}\vec v\times curl_{\vec x}\left(\vec v-\vec
k\right)-\frac{1}{c}\nabla_{\vec x}\left(\left(\vec v-\vec
k\right)\cdot\vec v\right)\right|^2\\+\frac{c^2}{8\pi
G}\left|curl_{\vec x}\left(\vec v-\vec
k\right)\right|^2+\frac{c^2\beta}{4\pi
G}\left(\frac{1}{4}\left|d_{\vec x}\vec v+\left\{d_{\vec x}\vec
v\right\}^T\right|^2-\left|\Div_{\vec x}\vec v\right|^2\right).
\end{multline}
In particular, in the inertial coordinate system where $d_{\vec
x}\vec k=0$ and $\partial_t\vec k=0$ we have:
\begin{multline}\label{vhfffngghkjgghDDnw22jkjkjkjk22gughhg22hkhjhjhjh22kljlkjk22mmaint}
L\left(\vec A,\Psi,\vec v,\Phi_0,\vec h,\vec
x,t\right)=L_1\left(\vec A,\Psi,\vec v,\vec
x,t\right):=\\
\frac{1}{8\pi}\left|-\nabla_{\vec
x}\Psi-\frac{1}{c}\frac{\partial\vec A}{\partial t}+\frac{1}{c}\vec
v\times curl_{\vec x}\vec A\right|^2-\frac{1}{8\pi}\left|curl_{\vec
x}\vec A\right|^2-\left(\rho\Psi-\frac{1}{c}\vec A\cdot\vec
j\right)+\frac{\mu}{2}\left|\vec u-\vec v\right|^2
\\
-\frac{c^2}{8\pi G}\left|
-\frac{1}{c}\frac{\partial\vec v}{\partial t}
+\frac{1}{c}\vec v\times curl_{\vec x}\vec v-\frac{1}{c}\nabla_{\vec
x}\left(\frac{1}{2}\left|\vec
v\right|^2\right)\right|^2+\frac{c^2}{8\pi G}\left|curl_{\vec x}\vec
v\right|^2\\+\frac{c^2\beta}{4\pi G}\left(\frac{1}{4}\left|d_{\vec
x}\vec v+\left\{d_{\vec x}\vec v\right\}^T\right|^2-\left|\Div_{\vec
x}\vec v\right|^2\right),
\end{multline}
Note here the advantage of inertial coordinate systems, where $L$
and $L_1$ are complectly independent of the vectorial potential of
the inertia $\vec k$. Furthermore,
we rewrite
\er{vhfffngghkjgghDDnw22jkjkjkjk22gughhg22hkhjhjhjh22kljlkjk22mmaint}
as:
\begin{multline}\label{vhfffngghkjgghDDnw22jkjkjkjk22gughhg22hkhjhjhjh22kljlkjk22mmint}
L\left(\vec A,\Psi,\vec v,\Phi_0,\vec h,\vec
x,t\right)=L_1\left(\vec A,\Psi,\vec v,\vec
x,t\right):=\\
\frac{1}{8\pi}\left|-\nabla_{\vec
x}\Psi-\frac{1}{c}\frac{\partial\vec A}{\partial t}+\frac{1}{c}\vec
v\times curl_{\vec x}\vec A\right|^2-\frac{1}{8\pi}\left|curl_{\vec
x}\vec A\right|^2-\left(\rho\Psi-\frac{1}{c}\vec A\cdot\vec
j\right)+\frac{\mu}{2}\left|\vec u-\vec v\right|^2
\\
-\frac{c^2}{8\pi G}\left|
-\frac{1}{c}\left(\frac{\partial\vec v}{\partial t}+d_{\vec x}\vec
v\cdot\vec v\right)\right|^2+\frac{c^2}{8\pi G}\left|curl_{\vec
x}\vec v\right|^2+\frac{c^2\beta}{4\pi
G}\left(\frac{1}{4}\left|d_{\vec x}\vec v+\left\{d_{\vec x}\vec
v\right\}^T\right|^2-\left|\Div_{\vec x}\vec v\right|^2\right).
\end{multline}
Then, using Proposition \ref{yghgjtgyrtrtint} by
\er{vhfffngghkjgghDDnw22mmint}, \er{btfffygtgyggyDDnw22jkhk22mmint}
and \er{vhfffngghkjgghDDnw22jkjkjkjk22gughhg22mmint} we deduce that
$L$ and $L_1$ are invariant under the change of inertial or
non-inertial cartesian coordinate system, provided that $\vec h$ and
$\vec A$ are proper vector fields, $\vec v$ is a speed-like vector
field and $\Phi_0$ and $\Psi_0:=\Psi-\frac{1}{c}\vec A\cdot\vec v$
are proper scalar fields.

We investigate critical points of the functional
\begin{equation}\label{btfffygtgyggyDDnw22mmint}
J=\int_0^T\int_{\mathbb{R}^3}L_1\left(\vec A,\Psi,\vec v,\vec
x,t\right)d\vec x dt.
\end{equation}
We denote
\begin{equation}\label{guigjgjffghDDnw22mmint}
\begin{cases}
\Psi_0=\Psi-\frac{1}{c}\vec A\cdot\vec v\\
\vec D=-\nabla_{\vec x}\Psi-\frac{1}{c}\frac{\partial\vec
A}{\partial t}+\frac{1}{c}\vec v\times curl_{\vec x}\vec
A=-\nabla_{\vec x}\Psi_0-\frac{1}{c}\frac{\partial\vec A}{\partial
t}+\frac{1}{c}\vec
v\times curl_{\vec x}\vec A-\frac{1}{c}\nabla_{\vec x}\left(\vec A\cdot\vec v\right)\\
\vec B=curl_{\vec x}\vec A
\\
\vec E=-\nabla_{\vec x}\Psi-\frac{1}{c}\frac{\partial\vec A}{\partial t}=\vec D-\frac{1}{c}\vec v\times\vec B\\
\vec H=curl_{\vec x}\vec A+\frac{1}{c}\vec
v\times\left(-\nabla_{\vec x}\Psi-\frac{1}{c}\frac{\partial\vec
A}{\partial t}+\frac{1}{c}\vec v\times curl_{\vec x}\vec
A\right)=\vec B+\frac{1}{c}\vec v\times\vec D.
\end{cases}
\end{equation}
and
\begin{equation}\label{guigjgjffghDDnwjkhhkhkhjklh22mmint}
\begin{cases}
\vec R=-\nabla_{\vec x}\Phi_0-\frac{1}{c}\frac{\partial\vec
h}{\partial t}
+\frac{1}{c}\vec v\times curl_{\vec x}\vec
h-\frac{1}{c}\nabla_{\vec x}\left(\vec h\cdot\vec v\right)\\
\vec Q=curl_{\vec x}\vec h,
\end{cases}
\end{equation}
where $\vec h$ is a proper vector field and $\Phi_0$ is a proper
scalar field that are given by \er{btfffygtgyggyDDnw22jkhk22mmint}.
In other words,
\begin{equation}\label{guigjgjffghDDnwjkhhkhkhjklh22ljjljkhk22mmint}
\begin{cases}
\vec R=\frac{1}{c}\nabla_{\vec x}\left(\frac{1}{2}\left|\vec v-\vec
k\right|^2\right)-\frac{1}{c}\frac{\partial}{\partial t}\left(\vec
v-\vec k\right)
+\frac{1}{c}\vec v\times curl_{\vec x}\left(\vec v-\vec
k\right)-\frac{1}{c}\nabla_{\vec x}\left(\left(\vec v-\vec
k\right)\cdot\vec v\right)\\
\vec Q=curl_{\vec x}(\vec v-\vec k),
\end{cases}
\end{equation}
and in inertial coordinate system where $d_{\vec x}\vec k=0$ and
$\partial_t\vec k=0$ we also have:
\begin{equation}\label{guigjgjffghDDnwjkhhkhkhjklh22ljjljkhk22hhmmint}
\begin{cases}
\vec R=-\frac{1}{c}\left(\frac{\partial\vec v}{\partial t}+d_{\vec
x}\vec
v\cdot\vec v\right)\\
\vec Q=curl_{\vec x}\vec v.
\end{cases}
\end{equation}
As before, by \er{guigjgjffghDDnw22mmint} and
\er{guigjgjffghDDnwjkhhkhkhjklh22mmint} and Proposition
\ref{yghgjtgyrtrtint} we infer that both $\vec D$, $\vec B$ and
$\vec R$, $\vec Q$ are proper vector fields. Then in section
\ref{ghjfhgfdhdnw22mm} we obtain that a configuration $\left(\vec
A,\Psi,\vec v,\vec x,t\right)$ is a critical point of the functional
\begin{equation}\label{btfffygtgyggyDDnw22mmjhkkhhkhint}
J=\int_0^T\int_{\mathbb{R}^3}L_1\left(\vec A,\Psi,\vec v,\vec
x,t\right)d\vec x dt,
\end{equation}
if and only if it satisfies:
\begin{equation}\label{guigjgjffghguygjyfDDnw22mmdint}
\begin{cases}
curl_{\vec x}\vec B=\frac{4\pi}{c}\vec
j+\frac{1}{c}\frac{\partial\vec D}{\partial t}-\frac{1}{c}curl_{\vec
x}\left(\vec v\times \vec D\right)
\\
div_{\vec x}\vec D=4\pi\rho
\\
curl_{\vec x}\vec D+\frac{1}{c}\frac{\partial\vec B}{\partial t}
-\frac{1}{c}\,curl_{\vec x}\left(\vec v\times\vec B\right)=0\\
div_{\vec x}\vec B=0\\
curl_{\vec x}\vec R+\frac{1}{c}\frac{\partial\vec Q}{\partial t}-\frac{1}{c}curl_{\vec x}\left(\vec v\times\vec Q\right)=0\\
div_{\vec x}\vec Q=0
\\
\frac{1}{c}\left(\frac{\partial}{\partial t}\left(\Div_{\vec x}\vec
v\right)+\vec v\cdot\nabla_{\vec x}\left(\Div_{\vec x}\vec
v\right)+\frac{1}{4}\left|d_{\vec x}\vec v+\left\{d_{\vec x}\vec
v\right\}^T\right|^2\right)=\frac{1}{2c}\left|\vec
Q\right|^2-\Div_{\vec x}\vec R
\\
curl_{\vec x}(curl_{\vec x}\vec v)=curl_{\vec x}\vec Q
\\
\frac{4\pi G}{c^2}\left(\mu\vec u-\mu\vec v+\frac{1}{4\pi c}\vec
D\times\vec
B-\frac{c}{4\pi G}\vec R\times\vec Q\right)=\\
(1+\beta)\,curl_{\vec x}\vec Q
-\frac{1}{c}\left(\frac{\partial\vec R}{\partial t}-curl_{\vec
x}\left\{\vec v\times\vec R\right\}+\left(div_{\vec x}\vec
R\right)\vec v\right),
\end{cases}
\end{equation}
and by \er{guigjgjffghDDnwjkhhkhkhjklh22ljjljkhk22hhmmint} in the
inertial frame we have:
\begin{equation}\label{guigjgjffghguygjyfDDnw22khhjhgg22mmdint}
\begin{cases}
curl_{\vec x}\vec B=\frac{4\pi}{c}\vec
j+\frac{1}{c}\frac{\partial\vec D}{\partial t}-\frac{1}{c}curl_{\vec
x}\left(\vec v\times \vec
D\right)\\
div_{\vec x}\vec D=4\pi\rho\\
curl_{\vec x}\vec D+\frac{1}{c}\frac{\partial\vec B}{\partial t}
-\frac{1}{c}\,curl_{\vec x}\left(\vec v\times\vec B\right)=0\\
div_{\vec x}\vec B=0\\
\vec R=-\frac{1}{c}\left(\frac{\partial\vec v}{\partial t}+d_{\vec
x}\vec
v\cdot\vec v\right)\\
\vec Q=curl_{\vec x}\vec v
\\
\frac{4\pi G}{c^2}\left(\mu\vec u-\mu\vec v+\frac{1}{4\pi c}\vec
D\times\vec
B-\frac{c}{4\pi G}\vec R\times\vec Q\right)=\\
(1+\beta)\,curl_{\vec x}\vec Q
-\frac{1}{c}\left(\frac{\partial\vec R}{\partial t}-curl_{\vec
x}\left\{\vec v\times\vec R\right\}+\left(div_{\vec x}\vec
R\right)\vec v\right).
\end{cases}
\end{equation}
Furthermore,  taking $\Div_{\vec x}$ of the both sides of the last
equality in \er{guigjgjffghguygjyfDDnw22mmdint} and using continuum
equation $\partial_t\mu+\Div_{\vec x}\left(\mu\vec u\right)=0$ we
deduce
\begin{multline}\label{guigjgjffghguygjyfDDnw22mmzint}
-\left(\partial_t\mu+\Div_{\vec x}\left(\mu\vec
v\right)\right)+\Div_{\vec x}\left\{\frac{1}{4\pi c}\vec D\times\vec
B-\frac{c}{4\pi  G}\vec R\times\vec Q\right\}=\\ \Div_{\vec
x}\left(\mu\vec u-\mu\vec v+\frac{1}{4\pi c}\vec D\times\vec
B-\frac{c}{4\pi  G}\vec R\times\vec Q\right)=\\
-\frac{c}{4\pi  G}\left(\frac{\partial}{\partial t}\left(\Div_{\vec
x}\vec R\right)+\Div_{\vec x}\left\{\left(div_{\vec x}\vec
R\right)\vec v\right\}\right),
\end{multline}
Therefore, considering the proper scalar quantity $Q_0$, that we
call the field mass, which satisfies
\begin{equation}\label{jtgiktiutiurtfirfmmhfffmmint}
Q_0:=-\mu+\frac{c}{4\pi  G}\,\Div_{\vec x}\vec R,
\end{equation}
by \er{guigjgjffghguygjyfDDnw22mmzint} we deduce
\begin{equation}\label{jtgiktiutiurtfirfmmint}
\frac{\partial Q_0}{\partial t}+div_{\vec x}\left(Q_0\vec
v\right)=-\Div_{\vec x}\left\{\frac{1}{4\pi c}\vec D\times\vec
B-\frac{c}{4\pi  G}\vec R\times\vec Q\right\}.
\end{equation}
Thus, we rewrite \er{guigjgjffghguygjyfDDnw22mmdint} as:
\begin{equation}\label{guigjgjffghguygjyfDDnw22mmint}
\begin{cases}
curl_{\vec x}\vec B=\frac{4\pi}{c}\left(\vec j-\rho\vec
v\right)+\frac{1}{c}\frac{\partial\vec D}{\partial
t}-\frac{1}{c}curl_{\vec x}\left(\vec v\times \vec
D\right)+\frac{1}{c}\left(div_{\vec x}\vec D\right)\vec v,
\\
div_{\vec x}\vec D=4\pi\rho,
\\
curl_{\vec x}\vec D+\frac{1}{c}\frac{\partial\vec B}{\partial t}
-\frac{1}{c}\,curl_{\vec x}\left(\vec v\times\vec B\right)=0,\\
div_{\vec x}\vec B=0,\\
curl_{\vec x}\vec R+\frac{1}{c}\frac{\partial\vec Q}{\partial t}-\frac{1}{c}curl_{\vec x}\left(\vec v\times\vec Q\right)=0,\\
div_{\vec x}\vec Q=0,
\\
\frac{4\pi G}{c^2}\left(\mu\vec u-\mu\vec v+\frac{1}{4\pi c}\vec
D\times\vec
B-\frac{c}{4\pi G}\vec R\times\vec Q\right)=\\
(1+\beta)\,curl_{\vec x}\vec Q
-\frac{1}{c}\left(\frac{\partial\vec R}{\partial t}-curl_{\vec
x}\left\{\vec v\times\vec R\right\}+\left(div_{\vec x}\vec
R\right)\vec v\right),\\
\Div_{\vec x}\vec R=\frac{4\pi G}{c}(\mu+Q_0),
\\
\frac{1}{c}\left(\frac{\partial}{\partial t}\left(\Div_{\vec x}\vec
v\right)+\vec v\cdot\nabla_{\vec x}\left(\Div_{\vec x}\vec
v\right)+\frac{1}{4}\left|d_{\vec x}\vec v+\left\{d_{\vec x}\vec
v\right\}^T\right|^2\right)=\frac{1}{2c}\left|\vec
Q\right|^2-\Div_{\vec x}\vec R,
\\
curl_{\vec x}(curl_{\vec x}\vec v)=curl_{\vec x}\vec Q,
\\
\frac{\partial Q_0}{\partial t}+div_{\vec x}\left(Q_0\vec
v\right)=-\Div_{\vec x}\left\{\frac{1}{4\pi c}\vec D\times\vec
B-\frac{c}{4\pi  G}\vec R\times\vec Q\right\},
\end{cases}
\end{equation}
and we rewrite \er{guigjgjffghguygjyfDDnw22khhjhgg22mmdint} in the
inertial frame as:
\begin{equation}\label{guigjgjffghguygjyfDDnw22khhjhgg22mmint}
\begin{cases}
curl_{\vec x}\vec B=\frac{4\pi}{c}\vec
j+\frac{1}{c}\frac{\partial\vec D}{\partial t}-\frac{1}{c}curl_{\vec
x}\left(\vec v\times \vec
D\right),\\
div_{\vec x}\vec D=4\pi\rho,\\
curl_{\vec x}\vec D+\frac{1}{c}\frac{\partial\vec B}{\partial t}
-\frac{1}{c}\,curl_{\vec x}\left(\vec v\times\vec B\right)=0,\\
div_{\vec x}\vec B=0,\\
\vec R=-\frac{1}{c}\left(\frac{\partial\vec v}{\partial t}+d_{\vec
x}\vec
v\cdot\vec v\right),\\
\vec Q=curl_{\vec x}\vec v,
\\
\frac{4\pi G}{c^2}\left(\mu\vec u-\mu\vec v+\frac{1}{4\pi c}\vec
D\times\vec
B-\frac{c}{4\pi G}\vec R\times\vec Q\right)=\\
(1+\beta)\,curl_{\vec x}\vec Q
-\frac{1}{c}\left(\frac{\partial\vec R}{\partial t}-curl_{\vec
x}\left\{\vec v\times\vec R\right\}+\left(div_{\vec x}\vec
R\right)\vec v\right),\\
\Div_{\vec x}\vec R=\frac{4\pi G}{c} (\mu+Q_0),\\
\frac{\partial Q_0}{\partial t}+div_{\vec x}\left(Q_0\vec
v\right)=-\Div_{\vec x}\left\{\frac{1}{4\pi c}\vec D\times\vec
B-\frac{c}{4\pi  G}\vec R\times\vec Q\right\}.
\end{cases}
\end{equation}
As before, by Proposition \ref{yghgjtgyrtrtint} we deduce that
\er{guigjgjffghguygjyfDDnw22mmint} is invariant under the change of
inertial or non-inertial cartesian coordinate systems. Moreover,
\er{guigjgjffghguygjyfDDnw22khhjhgg22mmint} is invariant under the
change of inertial cartesian coordinate systems. Note also that,
both, \er{guigjgjffghguygjyfDDnw22mmint} in an arbitrary inertial or
non-inertial cartesian coordinate system and
\er{guigjgjffghguygjyfDDnw22khhjhgg22mmint} in an arbitrary inertial
cartesian coordinate system, are complectly independent of the
vectorial potential of the inertia $\vec k$.

Finally, note that in the case of large constant $|\beta|\gg 1$ we
have $\vec Q\to 0$ in \er{guigjgjffghguygjyfDDnw22mmint} and thus,
the gravity equations \er{guigjgjffghguygjyfDDnw22mmint} reduce to
the equations of the Newtonian-type Gravity in the form of
\er{guigjgjffghguygjyfghggDDint}. In that case the gravity field
propagates with the infinite speed. On the other hand, in the case
of vanishing constant $\beta=0$ the form of equations for $\vec R$
and $\vec Q$ in \er{guigjgjffghguygjyfDDnw22mmint} is completely the
same as the form of the Maxwell equations for $\vec D$ and $\vec B$
in \er{guigjgjffghguygjyfDDnw22mmint}, except of the different
meaning of "charges" and "currents" in these two sets of equations.
In that case the electromagnetic and the gravity fields propagate
with the same speed. However, in the mixed case of constant
$\beta\sim 1$ the electromagnetic and the gravity fields propagate
with different finite speeds.

\subsection{Covariant formulation of the physical laws in the
four-dimensional non-relativistic space-time} In Section \ref{CVFRM}
we present the covariant (tensor) formulations of the Maxwell
Equations and the Lagrangian density of the electromagnetic field
and the covariant form of the Lagrangian of motion of charged
particles in the outer gravitational and electromagnetic fields.

\subsubsection{Four-vectors, four-covectors and tensors in the
four-dimensional non-relativistic space-time} First of all we would
like to remind the definitions of the vectors, covectors and
covariant and contravariant tensors of second order in
$\mathbb{R}^4$.
\begin{definition}
Given $\mathcal{S}$, that is a certain subgroup of the group of all
smooth non-degenerate invertible transformations from $\mathbb{R}^4$
onto $\mathbb{R}^4$ having the form
\begin{equation}\label{fgjfjhgghyuyyuint}
\begin{cases}
x'^0=f^{(0)}(x^0,x^1,x^2,x^3),\\
x'^1=f^{(1)}(x^0,x^1,x^2,x^3),\\
x'^2=f^{(2)}(x^0,x^1,x^2,x^3),\\
x'^3=f^{(3)}(x^0,x^1,x^2,x^3),
\end{cases}
\end{equation}
we say that a one-component field $a:=a(x^0,x^1,x^2,x^3)$ is a
scalar field on the group $\mathcal{S}$, if under the coordinate
transformation in the group $\mathcal{S}$ of the form
\er{fgjfjhgghyuyyuint} this field transforms as:
\begin{equation}\label{fgjfjhgghhgjgihhiint}
a'=a.
\end{equation}
Next we say that a four-component field $(a^0,a^1,a^2,a^3)$ is a
four-vector field on the group $\mathcal{S}$, if under the
coordinate transformation in the group $\mathcal{S}$ of the form
\er{fgjfjhgghyuyyuint} every of four components of this field
transforms as:
\begin{equation}\label{fgjfjhgghhgjgint}
a'^j=\sum_{k=0}^{3}\frac{\partial f^{(j)}}{\partial
x^k}a^k\quad\quad\forall j=0,1,2,3.
\end{equation}
Next we say that a four-component field $(a_0,a_1,a_2,a_3)$ is a
four-covector field on the group $\mathcal{S}$, if under the
coordinate transformation in the group $\mathcal{S}$ of the form
\er{fgjfjhgghyuyyuint} every of four components of this field
transforms as:
\begin{equation}\label{fgjfjhgghhgjghjhjint}
a_j=\sum_{k=0}^{3}\frac{\partial f^{(k)}}{\partial
x^j}a'_k\quad\quad\forall j=0,1,2,3.
\end{equation}
Furthermore, we say that a $16$-component field
$\{a_{mn}\}_{m,n=0,1,2,3}$ is a two times covariant tensor field on
the group $\mathcal{S}$, if under the coordinate transformation in
the group $\mathcal{S}$ of the form \er{fgjfjhgghyuyyuint} every of
$16$ components of this field transforms as:
\begin{equation}\label{fgjfjhgghhgjghjhjkkkkjjkint}
a_{mn}=\sum_{j=0}^{3}\sum_{k=0}^{3}\frac{\partial f^{(k)}}{\partial
x^m}\frac{\partial f^{(j)}}{\partial x^n}a'_{kj}\quad\quad\forall\,
m,n=0,1,2,3.
\end{equation}
Next we say that a $16$-component field $\{a^{mn}\}_{m,n=0,1,2,3}$
is a two times contravariant tensor field on the group
$\mathcal{S}$, if under the coordinate transformation in the group
$\mathcal{S}$ of the form \er{fgjfjhgghyuyyuint} every of $16$
components of this field transforms as:
\begin{equation}\label{fgjfjhgghhgjghjhjkkkkggghint}
a'^{mn}=\sum_{j=0}^{3}\sum_{k=0}^{3}\frac{\partial f^{(m)}}{\partial
x^k}\frac{\partial f^{(n)}}{\partial x^j}a^{kj}\quad\quad\forall\,
m,n=0,1,2,3.
\end{equation}
\end{definition}
Next consider the four-dimensional space-time $\mathbb{R}^4$, such
that for every point in space $\vec x=(x_1,x_2,x_3)\in\mathbb{R}^3$
and every instant of time $t$ we correspond the point
$(x^0,x^1,x^2,x^3)\in\mathbb{R}^4$ that has the form:
\begin{equation}\label{fgjfjhgghint}
(x^0,x^1,x^2,x^3):=(ct,x_1,x_2,x_3)=\left(ct,\vec x\right),
\end{equation}
where $c$ is the universal constant in Maxwell equations for vacuum.
In this space we denote by $\mathcal{S}_0$, the subgroup of the
group of smooth non-degenerate invertible mappings, containing
transformations of the form
\begin{equation}\label{noninchgravortbstrjgghguittu2intrrrZZygjyghhjint}
\begin{cases}
x'^0=x^0
\\
x'^j=\sum\limits_{k=1}^{3}A_{jk}\left(\frac{x^0}{c}\right)\,x_k+z_j\left(\frac{x^0}{c}\right)\quad\forall
j=1,2,3,
\end{cases}
\end{equation}
where $$\left\{A_{jk}(t)\right\}_{j,k=1,2,3}=A(t):\mathbb{R}\to
SO(3)$$ is a rotation, smoothly dependent on $t$ and
$$\left(z_1(t),z_2(t),z_3(t)\right)=\vec
z(t):\mathbb{R}\to\mathbb{R}^3$$ also smoothly dependent on $t$.
Then in the terms of time $t$ and three-dimensional space we rewrite
\er{noninchgravortbstrjgghguittu2intrrrZZygjyghhjint} as
\er{noninchgravortbstrjgghguittu2int}, i.e.:
\begin{equation}\label{noninchgravortbstrjgghguittu2intrrrZZygjygint}
\begin{cases}
\vec x'=A(t)\cdot\vec x+\vec z(t),\\
t'=t,
\end{cases}
\end{equation}
where $A(t)\in SO(3)$ is a rotation. I.e. the group $\mathcal{S}_0$
represents all transformations of cartesian non-inertial coordinate
systems in the non-relativistic space-time. It can be easily checked
by trivial calculations that $\mathcal{S}_0$ is indeed a group, i.e.
for every two transformations $f,g\in\mathcal{S}_0$ the composition
$g\circ f$ and the inverse transformation $f^{(-1)}$ are also
contained in $\mathcal{S}_0$, thats mean that they also have a form
of \er{noninchgravortbstrjgghguittu2intrrrZZygjyghhjint}. Next
assume that a four-covector $(a_0,a_1,a_2,a_3)$ and a four-vector
$(b^0,b^1,b^2,b^3)$ on the group $\mathcal{S}_0$ are given. Then, by
inserting \er{noninchgravortbstrjgghguittu2intrrrZZygjyghhjint} into
\er{fgjfjhgghhgjgint} and \er{fgjfjhgghhgjghjhjint} in Section
\ref{CVFRM}
%
%
%
%
%
%
we obtained the following laws of transformation of four-covectors
and four-vectors on the group $\mathcal{S}_0$, i.e. under the change
of non-inertial cartesian coordinate systems:
\begin{equation}\label{fgjfjhgghhgjghjhjijhojpiiihjhjint}
\begin{cases}
a'_0=a_0-\sum_{k=1}^{3}\frac{1}{c}\left(\sum\limits_{j=1}^{3}\frac{dA_{kj}}{dt}\left(\frac{x^0}{c}\right)\,x_j+\frac{d
z_k}{dt}\left(\frac{x^0}{c}\right)\right)\left(\sum_{j=1}^{3}A_{kj}\left(\frac{x^0}{c}\right)\,a_j\right)
\\
a'_k=\sum_{j=1}^{3}A_{kj}\left(\frac{x^0}{c}\right)\,a_j\quad\quad\forall
k=1,2,3,
\end{cases}
\end{equation}
and
\begin{equation}\label{fgjfjhgghhgjgiuouoiuujkjkjkint}
\begin{cases}
b'^0=b^0
\\
b'^j=\frac{1}{c}\left(\sum\limits_{k=1}^{3}\frac{dA_{jk}}{dt}\left(\frac{x^0}{c}\right)\,x_k+\frac{d
z_j}{dt}\left(\frac{x^0}{c}\right)\right)\,b^0+\sum_{k=1}^{3}A_{jk}\left(\frac{x^0}{c}\right)\,b^k\quad\quad\forall
j=1,2,3.
\end{cases}
\end{equation}
%
%
%
%
%
%
Therefore, if we denote the four-vector $(b^0,b^1,b^2,b^3)$ and the
four-covector $(a_0,a_1,a_2,a_3)$ on the group $\mathcal{S}_0$ as:
\begin{equation}\label{fgjfjhgghhgjghjhjijhojihjhjjijhjjjint}
\begin{cases}
(b^0,b^1,b^2,b^3)=\left(\sigma,\frac{1}{c}\vec
b\right)\quad\text{where}\quad
\sigma:=b^0\;\;\text{and}\;\;\vec b:=c(b^1,b^2,b^3)\in\mathbb{R}^3,\\
(a_0,a_1,a_2,a_3)=(\psi,-\vec a)\quad\text{where}\quad
\psi:=a_0\;\;\text{and}\;\;\vec a:=-(a_1,a_2,a_3)\in\mathbb{R}^3,
\end{cases}
\end{equation}
then by \er{fgjfjhgghhgjghjhjijhojpiiihjhjint} and
\er{fgjfjhgghhgjgiuouoiuujkjkjkint} in the terms of time $t$ and
three-dimensional space $\vec x$, we obtain the following laws of
transformations of $\sigma$, $\vec b$, $\psi$ and $\vec a$ under the
change of non-inertial cartesian coordinate system:
\begin{equation}\label{fgjfjhgghhgjgiuouoiuujkjkjkojkoiuint}
\begin{cases}
\sigma'=\sigma
\\
\vec b'=A\left(t\right)\cdot\vec
b+\left(\frac{dA}{dt}\left(t\right)\cdot\vec x+\frac{d \vec
z}{dt}\left(t\right)\right)\sigma,
\end{cases}
\end{equation}
and
\begin{equation}\label{fgjfjhgghhgjghjhjijhojpiiihjhjuioujint}
\begin{cases}
\psi'=\psi+\frac{1}{c}\left(\frac{dA}{dt}\left(t\right)\cdot\vec
x+\frac{d \vec z}{dt}\left(t\right)\right)\cdot\left(A(t)\cdot\vec
a\right)
\\
\vec a'=A(t)\cdot\vec a.
\end{cases}
\end{equation}
In particular, if $\sigma:=b^0$ is the first coordinate of an
arbitrary four-vector $(b^0,b^1,b^2,b^3)$ on the group
$\mathcal{S}_0$, then $\sigma$ is a proper scalar field in the
frames of Definition \ref{bggghghgjint}. Moreover, if $\vec
a:=-(a_1,a_2,a_3)$, where $a_1,a_2,a_3$ are the last three
coordinates of an arbitrary four-covector $(a_0,a_1,a_2,a_3)$ on the
group $\mathcal{S}_0$, then $\vec a$ is a proper vector field in the
frames of Definition \ref{bggghghgjint}.

Next, since by Definition \ref{bggghghgjint} every three-dimensional
speed-like vector field, $\vec u$ transforms under the change of
non-inertial cartesian coordinate system as:
\begin{equation}
\label{NoIn3redPPN'int}\vec u'=A(t)\cdot \vec
u+\frac{dA}{dt}\left(t\right)\cdot\vec x+\frac{d \vec
z}{dt}\left(t\right),
\end{equation}
by comparing \er{NoIn3redPPN'int} with
\er{fgjfjhgghhgjgiuouoiuujkjkjkojkoiuint} we deduce that for every
speed-like vector field $\vec u$ the four-component field
$(u^0,u^1,u^2,u^3)$ defined by
\begin{equation}\label{fgjfjhgghhgjghjhjijhojihjhjjijhjjjjjuiiint}
(u^0,u^1,u^2,u^3):=\left(1,\frac{1}{c}\vec
u\right)\quad\text{where}\quad
u^0=1\;\;\text{and}\;\;(u^1,u^2,u^3)=\frac{1}{c}\vec
u\in\mathbb{R}^3,
\end{equation}
is a four-vector field on the group $\mathcal{S}_0$. We call such
four-vectors by the name vectors of type $1$. In particular, if
$\vec u$ is the velocity field, then the quantity defined by
\er{fgjfjhgghhgjghjhjijhojihjhjjijhjjjjjuiiint} is a a four-vector
field on the group $\mathcal{S}_0$ that we call the four-dimensional
speed.
%
%
%
%
%
%
Thus, in particular, if $\vec
r(t)=\left(r_1(t),r_2(t),r_3(t)\right)$ is a three-dimensional
trajectory of the motion of some particle, parameterized by the
global time $t$, then if we consider a curve
$\frac{1}{c}\left(ct,r_1(t),r_2(t),r_3(t)\right)$ in $\mathbb{R}^4$,
parameterized by the global time $t$, then the four-component field:
\begin{equation}\label{fgjfjhgghhgjghjhjijhojihjhjjijhjjjjjuiijhjhhint}
\left(1,\frac{1}{c}\frac{d\vec r}{dt}(t)\right):=
\left(1,\frac{1}{c}\frac{dr_1}{dt}(t),\frac{1}{c}\frac{dr_2}{dt}(t),\frac{1}{c}\frac{dr_3}{dt}(t)\right)
\end{equation}
is a four-vector field on the group $\mathcal{S}_0$.

 Similarly, if $\vec v$ is the vectorial gravitational
potential, then since $\vec v$ is a speed-like vector field, the
four-component field $(v^0,v^1,v^2,v^3)$ defined by
\begin{equation}\label{fgjfjhgghhgjghjhjijhojihjhjjijhjjjjjuiijjjkint}
(v^0,v^1,v^2,v^3):=\left(1,\frac{1}{c}\vec
v\right)\quad\text{where}\quad
v^0=1\;\;\text{and}\;\;(v^1,v^2,v^3)=\frac{1}{c}\vec v,
\end{equation}
is also a four-vector field on the group $\mathcal{S}_0$ that we
call the four-dimensional gravitational potential. In the same way,
if $\vec k$ is the vectorial potential of inertia, defined by
Definition \ref{jjhgyftdtdint}, then since $\vec k$ is a speed-like
vector field, the four-component field $(k^0,k^1,k^2,k^3)$ defined
by
\begin{equation}\label{fgjfjhgghhgjghjhjijhojihjhjjijhjjjjjuiijjjkjkkhhkkhint}
(k^0,k^1,k^2,k^3):=\left(1,\frac{1}{c}\vec
k\right)\quad\text{where}\quad
k^0=1\;\;\text{and}\;\;(k^1,k^2,k^3)=\frac{1}{c}\vec k,
\end{equation}
is also a four-vector field on the group $\mathcal{S}_0$ that we
call the four-dimensional potential of inertia.

Moreover, by \er{fgjfjhgghhgjgiuouoiuujkjkjkojkoiuint},
%
%
%
%
%
%
if we consider the field of four-dimensional moment of a particle
$(p^0,p^1,p^2,p^3)$ defined by
\begin{equation}\label{fgjfjhgghhgjghjhjijhojihjhjjijhjjjjjuiijjjklihhint}
(p^0,p^1,p^2,p^3):=\left(m,\frac{1}{c}(m\vec
u)\right)\quad\text{where}\quad
p^0=m\;\;\text{and}\;\;(p^1,p^2,p^3)=\frac{1}{c}(m\vec u),
\end{equation}
where $m$ is the mass of the particle and $\vec u$ is the velocity
of the particle, then $(p^0,p^1,p^2,p^3)$ is also a four-vector on
the group $\mathcal{S}_0$. Moreover, by comparing
\er{yuythfgfyftydtydtydtyddyyyhhddhhhredPPN111hgghjgintint}
with
\er{fgjfjhgghhgjgiuouoiuujkjkjkojkoiuint} we deduce that if we
consider the field of four-dimensional electric current
$(j^0,j^1,j^2,j^3)$ defined by
\begin{equation}\label{fgjfjhgghhgjghjhjijhojihjhjjijhjjjjjuiijjjklihhojjjoint}
(j^0,j^1,j^2,j^3):=\left(\rho,\frac{1}{c}\,\vec
j\right)\quad\text{where}\quad
j^0=\rho\;\;\text{and}\;\;(j^1,j^2,j^3)=\frac{1}{c}\,\vec j,
\end{equation}
where $\rho$ is the electric charge density and $\vec j$ is the
electric current density, then $(j^0,j^1,j^2,j^3)$ is also a
four-vector on the group $\mathcal{S}_0$.

On the other hand, for every proper three-dimensional vector field
$\vec G$ that satisfies due to Definition \ref{bggghghgjint}:
\begin{equation}
\label{NoIn1redPPN'int}\vec G'=A(t)\cdot\vec G,
\end{equation}
by comparing \er{NoIn1redPPN'int} with
\er{fgjfjhgghhgjgiuouoiuujkjkjkojkoiuint} we deduce that the
four-component field $(G^0,G^1,G^2,G^3)$ defined by
\begin{equation}\label{fgjfjhgghhgjghjhjijhojihjhjjijhjjjjjuiiklkllo;int}
(G^0,G^1,G^2,G^3):=\left(0,\vec G\right)\quad\text{where}\quad
G^0=0\;\;\text{and}\;\;(G^1,G^2,G^3)=\vec G,
\end{equation}
is also a four-vector field on the group $\mathcal{S}_0$.  We call
such four-vectors by the name vectors of type $0$.

Next, since by
\er{vhfffngghhjghhgPPNghghghutghfflklhjkjhjhjjgjkghhjint} the scalar
electromagnetic potential $\Psi$ and the vector electromagnetic
potential $\vec A$, under the change of non-inertial cartesian
coordinate system transform as:
\begin{equation}\label{vhfffngghhjghhgPPNghghghutghfflklhjkjhjhjjgjkghhjhhjhjkhjghlkkint}
\begin{cases}
\Psi'=
\Psi+\frac{1}{c}\left(\frac{dA}{dt}(t)\cdot\vec x+\frac{d\vec
z}{dt}(t)\right)\cdot\left(A(t)\cdot\vec A\right)
\\
\vec A'=A(t)\cdot \vec A,
\end{cases}
\end{equation}
by comparing
\er{vhfffngghhjghhgPPNghghghutghfflklhjkjhjhjjgjkghhjhhjhjkhjghlkkint}
with \er{fgjfjhgghhgjghjhjijhojpiiihjhjuioujint} we deduce that the
four-component field $(A_0,A_1,A_2,A_3)$ defined as
\begin{equation}\label{fgjfjhgghhgjghjhjijhojihjhjjijhjjjljljpkint}
(A_0,A_1,A_2,A_3)=(\Psi,-\vec A)\quad\text{where}\quad
A_0=\Psi\;\;\text{and}\;\;(A_1,A_2,A_3)=-\vec A,
\end{equation}
is a four-\underline{covector} field on the group $\mathcal{S}_0$.
We call this  four-covector field by the name four dimensional
electromagnetic potential. Next, since $(A_0,A_1,A_2,A_3)$ is a
four-covector field on the group $\mathcal{S}_0$, then it is well
known from the tensor analysis that the $16$-component field
$\{F_{ij}\}_{0\leq i,j\leq 3}$ defined in every non-inertial
cartesian coordinate system by
\begin{equation}\label{huohuioy89gjjhjffffff3478zzrrZZZhjhhjhhjjhhffGGhjjhint}
F_{ij}:=\frac{\partial A_j}{\partial x^i}-\frac{\partial
A_i}{\partial x^j}\quad\quad\forall\, i,j=0,1,2,3\,,
\end{equation}
is an antisymmetric two times covariant tensor field on the group
$\mathcal{S}_0$, which we call the covariant tensor of the
electromagnetic field. In particular, by inserting
\er{fgjfjhgghhgjghjhjijhojihjhjjijhjjjljljpkint} and
\er{fgjfjhgghint} into
\er{huohuioy89gjjhjffffff3478zzrrZZZhjhhjhhjjhhffGGhjjhint}
%
%
%
%
%
%
and denoting:
\begin{equation}\label{MaxVacFull1bjkgjhjhgjgjgkjfhjfdghghligioiuittrPPNkkkint}
\begin{cases}
(B_1,B_2,B_3)=\vec B:= curl_{\vec x} \vec A,\\
(E_1,E_2,E_3)=\vec E:=-\nabla_{\vec
x}\Psi-\frac{1}{c}\frac{\partial\vec A}{\partial t},
\end{cases}
\end{equation}
we deduce:
\begin{equation}\label{huohuioy89gjjhjffffff3478zzrrZZZhjhhjhhjjhhffGGhjjhiuuijkjjkint}
\begin{cases}
F_{00}=0\\ F_{0j}=-F_{j0}=E_j\quad\quad\forall\, j=1,2,3\\
F_{jj}=0\quad\quad\forall\, j=1,2,3
\\
F_{12}=-F_{21}=-B_3
\\
F_{13}=-F_{31}=B_2
\\
F_{23}=-F_{32}=-B_1\,.
\end{cases}
\end{equation}

Next assume that $T:=\left\{T_{ij}\right\}_{i,j=1,2,3}\in
\R^{3\times 3}$ is a $9$-component proper matrix valued field,
which, being a proper matrix field, by Definition \ref{bggghghgjint}
satisfies:
\begin{equation}\label{uguyytfddddgghjjghjjjihohjjkint}
T'=A(t)\cdot T\cdot A^T(t)=A(t)\cdot T\cdot
\left\{A(t)\right\}^{-1}.
\end{equation}
Next consider a $16$-component field $\{\mathcal{T}^{ij}\}_{0\leq
i,j\leq 3}$ defined in every non-inertial cartesian coordinate
system by
\begin{equation}\label{huohuioy89gjjhjffffff3478zzrrZZZhjhhjhhjjhhffGGhjjhuiiuihhjint}
\begin{cases}
\mathcal{T}^{00}=0
\\
\mathcal{T}^{0j}=\mathcal{T}^{j0}=0\quad\quad\forall\, j=1,2,3
\\
\mathcal{T}^{ij}:=T_{ij}\quad\quad\forall\, i,j=1,2,3\,,
\end{cases}
\end{equation}
Then, by inserting
\er{noninchgravortbstrjgghguittu2intrrrZZygjyghhjint} and
\er{uguyytfddddgghjjghjjjihohjjkint} into
\er{fgjfjhgghhgjghjhjkkkkggghint} in Section \ref{CVFRM} we  prove
that the field $\{\mathcal{T}^{ij}\}_{0\leq i,j\leq 3}$ defined by
\er{huohuioy89gjjhjffffff3478zzrrZZZhjhhjhhjjhhffGGhjjhuiiuihhjint}
is a two times \underline{contravariant} tensor field on the group
$\mathcal{S}_0$.
%
%
%
%
%
%

In particular, if we consider the $9$-component matrix field $I$
that defined in every cartesian coordinate system as
$I:=\left\{\delta_{ij}\right\}_{1,j=1,2,3}\in \R^{3\times 3}$, where
\begin{equation}\label{huohuioy89gjjhjffffff3478zzrrZZZhjhhjhhjjhhffGGhjjhuiiuihhjkklpklint}
\delta_{ij}:=
\begin{cases}
1\quad\text{if}\quad i=j
\\
0\quad\text{if}\quad i\neq j,
\end{cases}
\end{equation}
which is a proper matrix field, since
\begin{equation}\label{uguyytfddddgghjjghjjjihohjjkkkookint}
I=A(t)\cdot I\cdot \left\{A(t)\right\}^{-1},
\end{equation}
then the $16$-component field $\{{\Theta}^{ij}\}_{0\leq i,j\leq 3}$
defined in every non-inertial cartesian coordinate system by
\begin{equation}\label{huohuioy89gjjhjffffff3478zzrrZZZhjhhjhhjjhhffGGhjjhuiiuihhjkkoioiint}
\begin{cases}
{\Theta}^{00}=0
\\
{\Theta}^{0j}={\Theta}^{j0}=0\quad\quad\forall\, j=1,2,3
\\
{\Theta}^{ij}:=\delta_{ij}\quad\quad\forall\, i,j=1,2,3
\end{cases}
\end{equation}
is a two times contravariant tensor field on the group
$\mathcal{S}_0$ and moreover, this tensor is symmetric. We call
$\{{\Theta}^{ij}\}_{0\leq i,j\leq 3}$ the contravariant tensor of
the three-dimensional geometry.

Next, the scalar field $\tau:=\tau(x^0,x^1,x^2,x^3)$, defined in
every cartesian coordinate system as
\begin{equation}\label{uguyytfddddgghkjhhjuuiuiint}
\tau:=\frac{x^0}{c}=t,
\end{equation}
is a scalar on the group $\mathcal{S}_0$. Here $t$ is the global
non-relativistic time. Moreover,
by \er{uguyytfddddgghkjhhjuuiuiint}, the four-component field
$(v_0,v_1,v_2,v_3)$ defined as a gradient of the global time by:
\begin{equation}\label{fgjfjhgghhgjghjhjkkkkgjghghuiiiulkkjlkklplikklkjljghhgghkint}
v_0:=c\,\frac{\partial \tau}{\partial
x^0}(x^0,x^1,x^2,x^3)=1\quad\text{and}\quad v_j:=c\,\frac{\partial
\tau}{\partial x^j}(x^0,x^1,x^2,x^3)=0\quad\quad\forall\,j=1,2,3,
\end{equation}
is a \underline{four-covector} field on the group $\mathcal{S}_0$.

Finally, consider a motion of a classical particle with inertial
mass $m$, charge $\sigma$, place $\vec r(t)$ and velocity $\vec
u(t)=\frac{d\vec r}{dt}(t)$ in the outer gravitational field with
the vectorial gravitational potential $\vec v(\vec x,t)$, the outer
electromagnetic field with vectorial and scalar potentials $\vec
A(\vec x,t)$ and $\vec \Psi(\vec x,t)$, and additional conservative
field with scalar potential $V(\vec x,t)$ ruled by a Lagrangian
\er{vhfffngghkjgghfjjint}:
\begin{equation}\label{vhfffngghkjgghfjjuuuint}
L_0\left(\frac{d\vec r}{dt},\vec
r,t\right):=\frac{m}{2}\left|\frac{d\vec r}{dt}-\vec v(\vec
r,t)\right|^2-\sigma\left(\Psi(\vec r,t)-\frac{1}{c}\vec A(\vec
r,t)\cdot\frac{d\vec r}{dt}\right)+V(\vec r,t).
\end{equation}
Then $L_0$ is a scalar on the group $\mathcal{S}_0$. Moreover,
consider the generalized momentum of the particle $m$ by
\er{guytyurtydftyiujhint}:
\begin{equation}\label{guytyurtydftyiujhuuuint}
\vec P:=\nabla_{\vec r'}L_0\left(\vec r',\vec r,t\right)=m
\frac{d\vec r}{dt}-m\vec v(\vec r,t)+\frac{\sigma}{c}\vec A(\vec
r,t),
\end{equation}
consider a Hamiltonian
\begin{equation}\label{vhfffngghkjgghfjjhyjjfguuuint}
H_0\left(\vec P,\vec r,t\right):=\vec P\cdot\frac{d\vec
r}{dt}-L_0\left(\frac{d\vec r}{dt},\vec r,t\right),
\end{equation}
which by \er{vhfffngghkjgghfjjghghghint} satisfies
\begin{equation}\label{vhfffngghkjgghfjjghghghuuuint} H_0\left(\vec
P,\vec r,t\right)= \vec P\cdot\vec v(\vec r,t)+
\frac{1}{2m}\left|\vec P-\frac{\sigma}{c}\vec A(\vec
r,t)\right|^2+\sigma\left(\Psi(\vec r,t)-\frac{1}{c}\vec A(\vec
r,t)\cdot\vec v(\vec r,t)\right)-V\left(\vec r,t\right),
\end{equation}
and furthermore, define the four-dimensional generalized momentum
$(P_0,P_1,P_2,P_3)$ as:
\begin{equation}\label{fgjfjhgghhgjghjhjijhojihjhjjijhjjjljljpkhkhjgjgint}
(P_0,P_1,P_2,P_3):=\left(\frac{1}{c}H_0,-\vec
P\right)\quad\text{where}\quad
P_0=\frac{1}{c}H_0\;\;\text{and}\;\;(P_1,P_2,P_3)=-\vec P,
\end{equation}
Then, since by \er{vhfffngghkjgghfjjghghghuuuint} and
\er{guytyurtydftyiujhuuuint}, under the change of non-inertial
cartesian coordinate system $H_0$ and $\vec P$ transform as
\begin{equation}\label{vhfffngghhjghhgPPNghghghutghfflklhjkjhjhjjgjkghhjhhjhjkhjghlkkllint}
\begin{cases}
H_0'=
H_0+\left(\frac{dA}{dt}(t)\cdot\vec x+\frac{d\vec
z}{dt}(t)\right)\cdot\left(A(t)\cdot\vec P\right)
\\
\vec P'=A(t)\cdot \vec P,
\end{cases}
\end{equation}
by comparing
\er{vhfffngghhjghhgPPNghghghutghfflklhjkjhjhjjgjkghhjhhjhjkhjghlkkllint}
with \er{fgjfjhgghhgjghjhjijhojpiiihjhjuioujint} we deduce that the
four-dimensional momentum $(P_0,P_1,P_2,P_3)$ is a
four-\underline{covector} on the group $\mathcal{S}_0$.

\subsubsection{Pseudo-metric tensors of the four-dimensional
space-time} Consider $\{g^{ij}\}_{0\leq i,j\leq 3}$ to be a two
times contravariant tensor field on the group $\mathcal{S}_0$,
defined by
\begin{equation}\label{hoyuiouigyfghgjh3478zzrrZZffGGjkkjojjint}
g^{ij}:=v^iv^j-{\Theta}^{ij}\quad\quad\forall\,i,j=0,1,2,3,
\end{equation}
where $\{{\Theta}^{ij}\}_{0\leq i,j\leq 3}$ is the contravariant
tensor of the three-dimensional geometry, defined by
\er{huohuioy89gjjhjffffff3478zzrrZZZhjhhjhhjjhhffGGhjjhuiiuihhjkkoioiint}
and being a two times contravariant tensor, and $(v^0,v^1,v^2,v^3)$
is the four-dimensional gravitational potential, defined by
\er{fgjfjhgghhgjghjhjijhojihjhjjijhjjjjjuiijjjkint} and being a
four-vector. Then,
in Section \ref{CVFRM} we obtain that $\{g^{ij}\}_{0\leq i,j\leq 3}$
is indeed a two times contravariant tensor field on the group
$\mathcal{S}_0$ and moreover, this tensor is symmetric. Moreover, by
\er{huohuioy89gjjhjffffff3478zzrrZZZhjhhjhhjjhhffGGhjjhuiiuihhjkkoioiint}
and \er{fgjfjhgghhgjghjhjijhojihjhjjijhjjjjjuiijjjkint} we have:
\begin{equation}\label{hoyuiouigyfghgjh3478zzrrZZffGGjkkjint}
\begin{cases}
g^{00}=1\\
g^{ij}=-\delta_{ij}+\frac{v^iv^j}{c^2}\quad\forall 1\leq i,j\leq 3\\
g^{0j}=g^{j0}=\frac{v^j}{c}\quad\forall 1\leq j\leq 3,
\end{cases}
\end{equation}
where $\vec v=(v^1,v^2,v^3)$ is the three-dimensional vectorial
gravitational potential. We call the tensor $\{g^{ij}\}_{0\leq
i,j\leq 3}$ the contravariant pseudo-metric tensor of the
four-dimensional space-time. Next consider a $16$-component field
$\{g_{ij}\}_{0\leq i,j\leq 3}$ defined by
\begin{equation}\label{hoyuiouigyfg3478zzrrZZffGGhhjhjint}
\begin{cases}
g_{00}=1-\frac{|\vec v|^2}{c^2}\\
g_{ij}=-\delta_{ij}\quad\forall 1\leq i,j\leq 3\\
g_{0j}=g_{j0}=\frac{v^j}{c}\quad\forall 1\leq j\leq 3.
\end{cases}
\end{equation}
Then in Section \ref{CVFRM}
%
%
%
%
%
%
we deduce:
\begin{equation}\label{fgjfjhgghhgjghjhjkkkkgjghghuiiiuujhjhint}
\sum_{k=0}^{3}g^{ik}g_{kj}=\begin{cases}
1\quad\text{if}\quad i=j\\
0\quad\text{if}\quad i\neq j
\end{cases}\quad\quad\forall\, i,j=0,1,2,3.
\end{equation}
Therefore,
we obtain that $\{g_{ij}\}_{i,j=0,1,2,3}$ is a two times covariant
tensor on the group $\mathcal{S}_0$, and moreover, this tensor is
symmetric. We call the tensor $\{g_{ij}\}_{0\leq i,j\leq 3}$
covariant pseudo-metric tensor of the four-dimensional space-time.
Using \er{fgjfjhgghhgjghjhjkkkkgjghghuiiiuujhjhint} we also obtain
that the pseudo-metric tensors $\{g_{ij}\}_{i,j=0,1,2,3}$ and
$\{g^{ij}\}_{0\leq i,j\leq 3}$ are non-degenerate.
Moreover, it can be easily calculated that if
we consider the $4\times 4$-matrix:
\begin{equation}\label{fgjfjhgghhgjghjhjkkkkgjghghuiiiuujhjhjkljint}
G=\{g_{ij}\}_{0\leq i,j\leq 3},
\end{equation}
then
\begin{equation}\label{fgjfjhgghhgjghjhjkkkkgjghghuiiiuujhjh1int}
\text{det}\,G=-1.
\end{equation}

Thus, with the covariant and contravariant pseudo-metric tensors we
can lower and lift indexes of arbitrary tensors. In particular given
a four-covector $(a_0,a_1,a_2,a_3)$ and a four-vector
$(b^0,b^1,b^2,b^3)$ on the group $\mathcal{S}_0$ we can define the
corresponding lifted four-vector $(a^0,a^1,a^2,a^3)$ and the
corresponded lowered four-covector $(b_0,b_1,b_2,b_3)$ by
\begin{equation}\label{fgjfjhgghhgjghjhjkkkkgjghghuiiiulkkjKKint}
(a^0,a^1,a^2,a^3):=\Big\{\sum_{k=0}^{3}g^{mk}a_{k}\Big\}_{m=0,1,2,3}\quad\text{and}\quad
(b_0,b_1,b_2,b_3):=\Big\{\sum_{k=0}^{3}g_{mk}b^{k}\Big\}_{m=0,1,2,3}
\end{equation}
Then by \er{hoyuiouigyfghgjh3478zzrrZZffGGjkkjint},
\er{hoyuiouigyfg3478zzrrZZffGGhhjhjint} and
\er{fgjfjhgghhgjghjhjkkkkgjghghuiiiulkkjKKint} we have:
%
%
%
%
%
%
\begin{equation}\label{fgjfjhgghhgjghjhjkkkkgjghghuiiiulkkjKKyuyyuiouiint}
a^0=a_{0}+\sum_{k=1}^{3}\frac{1}{c}v^k a_{k}\quad\text{and}\quad
a^m=-a_{m}+\frac{1}{c}a^0v^m\quad\quad\forall m=1,2,3,
\end{equation}
and
\begin{equation}\label{fgjfjhgghhgjghjhjkkkkgjghghuiiiulkkjKKyuyyuuuuiiuiiint}
b_0=b^0-\sum_{k=1}^{3}\frac{1}{c}v^kb_k \quad\text{and}\quad
b_m=-b^{m}+\frac{1}{c} b^{0}v^m\quad\quad\forall m=1,2,3.
\end{equation}
In particular, we have:
%
%
%
%
%
%
\begin{equation}\label{fgjfjhgghhgjghjhjkkkkgjghghuiiiulkkjKKyuyyu0ioioiogghghghint}
b^0a_0+\sum_{k=1}^{3}b^ka_k=b^0a^0-\sum_{k=1}^{3}b_{k}a_k.
\end{equation}

Next, if for every speed-like vector field $\vec u$ we consider the
four-vector field $(u^0,u^1,u^2,u^3)$ defined by
\er{fgjfjhgghhgjghjhjijhojihjhjjijhjjjjjuiiint} as:
\begin{equation}\label{fgjfjhgghhgjghjhjijhojihjhjjijhjjjjjuiikkjint}
(u^0,u^1,u^2,u^3):=\left(1,\frac{1}{c}\vec
u\right)\quad\text{where}\quad
u^0=1\;\;\text{and}\;\;(u^1,u^2,u^3)=\frac{1}{c}\vec
u\in\mathbb{R}^3,
\end{equation}
then, by \er{fgjfjhgghhgjghjhjkkkkgjghghuiiiulkkjKKyuyyuuuuiiuiiint}
the corresponding lowered four-covector field $(u_0,u_1,u_2,u_3)$
satisfies:
\begin{multline}\label{fgjfjhgghhgjghjhjijhojihjhjjijhjjjjjuiikkjjnjint}
(u_0,u_1,u_2,u_3):=\left(1+\frac{1}{c^2}\left(\vec u-\vec
v\right)\cdot\vec v,-\frac{1}{c}\left(\vec u-\vec
v\right)\right)\quad\text{where}\quad\\
u_0=1+\frac{1}{c^2}\left(\vec u-\vec v\right)\cdot\vec
v\;\;\text{and}\;\;(u_1,u_2,u_3)=-\frac{1}{c}\left(\vec u-\vec
v\right)\in\mathbb{R}^3.
\end{multline}
Moreover, in the case where $(u^0,u^1,u^2,u^3)$  is a
four-dimensional speed, we call the corresponding lowered
four-covector field $(u_0,u_1,u_2,u_3)$ by the name four-dimensional
cospeed. In particular, if we consider the four-dimensional
gravitational potential $(v^0,v^1,v^2,v^3)$ defined by
\er{fgjfjhgghhgjghjhjijhojihjhjjijhjjjjjuiijjjkint}:
\begin{equation}\label{fgjfjhgghhgjghjhjijhojihjhjjijhjjjjjuiijjjkhjhjhjint}
(v^0,v^1,v^2,v^3):=\left(1,\frac{1}{c}\vec v\right),
\end{equation}
then by \er{fgjfjhgghhgjghjhjijhojihjhjjijhjjjjjuiikkjjnjint} we
obtain that the corresponding lowered four-covector field
$(v_0,v_1,v_2,v_3)$, that we call the four-covector of gravitational
potential, satisfies:
\begin{equation}\label{fgjfjhgghhgjghjhjijhojihjhjjijhjjjjjuiikkjjnj1int}
(v_0,v_1,v_2,v_3):=\left(1,0,0,0\right).
\end{equation}
Note that the four-covector of gravitational potential, defined by
\er{fgjfjhgghhgjghjhjijhojihjhjjijhjjjjjuiikkjjnj1int} coincides
with the four-covector defined by
\er{fgjfjhgghhgjghjhjkkkkgjghghuiiiulkkjlkklplikklkjljghhgghkint} as
the gradient of the scalar of global time. Next, by
\er{fgjfjhgghhgjghjhjijhojihjhjjijhjjjjjuiijjjkhjhjhjint} and
\er{fgjfjhgghhgjghjhjijhojihjhjjijhjjjjjuiikkjjnj1int} we clearly
have:
\begin{equation}\label{fgjfjhgghhgjghjhjijhojihjhjjijhjjjjjuiijjjkhjhjhjuiiuuuyuint}
c^2\left(\sum_{j=0}^{3}\sum_{k=0}^{3}\,g^{jk}\frac{\partial\tau}{\partial
x^j}\frac{\partial\tau}{\partial
x^k}\right)=\sum_{j=0}^{3}\sum_{k=0}^{3}g^{jk}v_jv_k=\sum_{j=0}^{3}\sum_{k=0}^{3}g_{jk}v^jv^k=\sum_{j=0}^{3}v^jv_j=1,
\end{equation}
where $\tau$ is the scalar of the global time on the group
$\mathcal{S}_0$, defined by \er{uguyytfddddgghkjhhjuuiuiint}.
Finally, we clearly have
\begin{equation}\label{fgjfjhfgfgfgfhint}
\sum_{k=0}^{3}{\Theta}^{mk}\frac{\partial\tau}{\partial
x^k}=\sum_{k=0}^{3}{\Theta}^{mk}v_k=0\quad\quad\forall\, m=0,1,2,3,
\end{equation}
where $\Theta^{ij}$ is the contravariant tensor of the
three-dimensional geometry, defined by
\er{huohuioy89gjjhjffffff3478zzrrZZZhjhhjhhjjhhffGGhjjhuiiuihhjkkoioiint}.

%
%
%
%
%
%
Moreover, if we consider the field of four-vector of the moment of a
particle $(p^0,p^1,p^2,p^3)$ defined by
\er{fgjfjhgghhgjghjhjijhojihjhjjijhjjjjjuiijjjklihhint} as
\begin{equation}\label{fgjfjhgghhgjghjhjijhojihjhjjijhjjjjjuiijjjklihhjjjint}
(p^0,p^1,p^2,p^3):=\left(m,\frac{1}{c}(m\vec
u)\right)\quad\text{where}\quad
p^0=m\;\;\text{and}\;\;(p^1,p^2,p^3)=\frac{1}{c}(m\vec u),
\end{equation}
where $m$ is the mass of the particle and $\vec u$ is the velocity
of the particle, then the corresponding lowered four-covector field
$(p_0,p_1,p_2,p_3)$, which we call the four-covector of momentum,
satisfies:
\begin{multline}\label{fgjfjhgghhgjghjhjijhojihjhjjijhjjjjjuiikkjjnjjbjhjhint}
(p_0,p_1,p_2,p_3):=\left(m\left(1+\frac{1}{c^2}\left(\vec u-\vec
v\right)\cdot\vec v\right),-\frac{m}{c}\left(\vec u-\vec
v\right)\right)\quad\text{where}\quad\\
p_0=m\left(1+\frac{1}{c^2}\left(\vec u-\vec v\right)\cdot\vec
v\right)\;\;\text{and}\;\;(p_1,p_2,p_3)=-\frac{m}{c}\left(\vec
u-\vec v\right).
\end{multline}
In particular, by
\er{fgjfjhgghhgjghjhjkkkkgjghghuiiiulkkjKKyuyyu0ioioiogghghghint} we
have:
\begin{equation}\label{fgjfjhgghhgjghjhjkkkkgjghghuiiiulkkjKKyuyyu0ioioiogghghghgghghint}
-\frac{c^2}{2m}\left(p^0p_0+\sum_{k=1}^{3}p^kp_k\right)=\frac{mc^2}{2}\left(\frac{1}{c^2}\left|\vec
u-\vec v\right|^2-1\right)=\frac{m}{2}\left|\vec u-\vec
v\right|^2-\frac{mc^2}{2}.
\end{equation}
Moreover, if we consider the  four-dimensional electric current
$(j^0,j^1,j^2,j^3)$ defined by
\er{fgjfjhgghhgjghjhjijhojihjhjjijhjjjjjuiijjjklihhojjjoint} as
\begin{equation}\label{fgjfjhgghhgjghjhjijhojihjhjjijhjjjjjuiijjjklihhojjjokjkjint}
(j^0,j^1,j^2,j^3):=\left(\rho,\frac{1}{c}\,\vec
j\right)\quad\text{where}\quad
j^0=\rho\;\;\text{and}\;\;(j^1,j^2,j^3)=\frac{1}{c}\,\vec j,
\end{equation}
where $\rho$ is the electric charge density and $\vec j$ is the
electric current density, then the corresponding lowered
four-covector field $(j_0,j_1,j_2,j_3)$, which we call the
four-covector of current, satisfies:
\begin{multline}\label{fgjfjhgghhgjghjhjijhojihjhjjijhjjjjjuiikkjjnjjbjhjhiyuyugint}
(j_0,j_1,j_2,j_3):=\left(\rho+\frac{1}{c^2}\left(\vec j-\rho\vec
v\right)\cdot\vec v,-\frac{1}{c}\left(\vec j-\rho\vec
v\right)\right)\quad\text{where}\quad\\
j_0=\rho+\frac{1}{c^2}\left(\vec j-\rho\vec v\right)\cdot\vec
v\;\;\text{and}\;\;(j_1,j_2,j_3)=-\frac{1}{c}\left(\vec j-\rho\vec
v\right).
\end{multline}
Finally, if $\Psi$ is the scalar electromagnetic potential and $\vec
A$ is the vector electromagnetic potential and we consider the
four-\underline{covector} field of four dimensional electromagnetic
potential $(A_0,A_1,A_2,A_3)$, defined by
\er{fgjfjhgghhgjghjhjijhojihjhjjijhjjjljljpkint} as:
\begin{equation}\label{fgjfjhgghhgjghjhjijhojihjhjjijhjjjljljpkikhjint}
(A_0,A_1,A_2,A_3)=(\Psi,-\vec A)\quad\text{where}\quad
A_0=\Psi\;\;\text{and}\;\;(A_1,A_2,A_3)=-\vec A,
\end{equation}
then by inserting
\er{fgjfjhgghhgjghjhjijhojihjhjjijhjjjljljpkikhjint} into
\er{fgjfjhgghhgjghjhjkkkkgjghghuiiiulkkjKKyuyyuiouiint} we deduce
that the corresponding lifted four-vector field $(A^0,A^1,A^2,A^3)$,
which we call the four-vector of electromagnetic potential,
satisfies:
\begin{equation}\label{fgjfjhgghhgjghjhjkkkkgjghghuiiiulkkjKKyuyyuiouigjguuuiuiint}
A^0=\Psi-\frac{1}{c}\vec v\cdot\vec
A\quad\text{and}\quad(A^1,A^2,A^3)=\vec
A+\frac{1}{c}\left(\Psi-\frac{1}{c}\vec v\cdot\vec A\right)\vec v.
\end{equation}
On the other hand, the proper scalar electromagnetic potential
$\Psi_0$ was defined by
\er{vhfffngghhjghhgPPNghghghutghffugghjhjkjjklint} as:
\begin{equation}\label{vhfffngghhjghhgPPNghghghutghffugghjhjkjjklhjghint}
\Psi_0:=\Psi-\frac{1}{c}\vec A\cdot\vec v.
\end{equation}
Thus we rewrite
\er{fgjfjhgghhgjghjhjkkkkgjghghuiiiulkkjKKyuyyuiouigjguuuiuiint} as:
\begin{equation}\label{fgjfjhgghhgjghjhjkkkkgjghghuiiiulkkjKKyuyyuiouigjguuuiui1int}
A^0=\Psi_0\quad\text{and}\quad(A^1,A^2,A^3)=\vec
A+\frac{1}{c}\Psi_0\vec v.
\end{equation}

Next given a two times covariant tensor $\{c_{mn}\}_{m,n=0,1,2,3}$
on the group $\mathcal{S}_0$
we consider two times contravariant lifted tensor
on $\mathcal{S}_0$: $\{c^{mn}\}_{m,n=0,1,2,3}$
defined by:
\begin{equation}\label{fgjfjhgghhgjghjhjkkkkgjghghuiiiulkkjlkklKKint}
c^{mn}:=\sum_{k=0}^{3}\sum_{j=0}^{3}g^{mj}g^{nk}c_{jk}
\quad\quad\forall\, m,n=0,1,2,3.
\end{equation}
In particular, if $\{F_{ij}\}_{0\leq i,j\leq 3}$ is the
antisymmetric two times covariant tensor field of the
electromagnetic field on the group $\mathcal{S}_0$, which by
\er{huohuioy89gjjhjffffff3478zzrrZZZhjhhjhhjjhhffGGhjjhiuuijkjjkint}
satisfies:
\begin{equation}\label{huohuioy89gjjhjffffff3478zzrrZZZhjhhjhhjjhhffGGhjjhiuuijkjjkihjhjhint}
\begin{cases}
F_{00}=0\\ F_{0j}=-F_{j0}=E_j\quad\quad\forall\, j=1,2,3\\
F_{jj}=0\quad\quad\forall\, j=1,2,3
\\
F_{12}=-F_{21}=-B_3
\\
F_{13}=-F_{31}=B_2
\\
F_{23}=-F_{32}=-B_1\,,
\end{cases}
\end{equation}
then by inserting
\er{huohuioy89gjjhjffffff3478zzrrZZZhjhhjhhjjhhffGGhjjhiuuijkjjkihjhjhint}
into \er{fgjfjhgghhgjghjhjkkkkgjghghuiiiulkkjlkklKKint}, using
\er{hoyuiouigyfghgjh3478zzrrZZffGGjkkjint} and denoting:
\begin{equation}\label{MaxVacFullPPNhjjghint}
\begin{cases}
(D_1,D_2,D_3)=\vec D:=\vec E+\frac{1}{c}\,\vec v\times
\vec B\\
(H_1,H_2,H_3)=\vec H:=\vec B+\frac{1}{c}\,\vec v\times \vec D,
\end{cases}
\end{equation}
we deduce:
%
%
%
%
%
%
\begin{equation}\label{khjhhkfgjfjhgghhgjghjhjkkkkgjghghuiiiulkkjlkklKKgfgjhjjghgjhhjhjhhhhhghhgint}
\begin{cases}
F^{00}=0
\\
F^{0j}=-F^{j0}=-D_j\quad\forall\, j=1,2,3,\\
F^{jj}=0 \quad\forall\, j=1,2,3,\\
F^{12}=-F^{21}=-H_3\\
F^{13}=-F^{31}=H_2\\
F^{23}=-F^{32}=-H_1.
\end{cases}
\end{equation}
In particular, by
\er{huohuioy89gjjhjffffff3478zzrrZZZhjhhjhhjjhhffGGhjjhiuuijkjjkihjhjhint}
and
\er{khjhhkfgjfjhgghhgjghjhjkkkkgjghghuiiiulkkjlkklKKgfgjhjjghgjhhjhjhhhhhghhgint},
using \er{MaxVacFullPPNhjjghint} we deduce that:
%
%
%
%
%
%
\begin{equation}\label{MaxVacFullPPNhjjghjjkjhh1int}
-\sum_{j=0}^{3}\sum_{k=0}^{3}\frac{1}{4}F^{jk}F_{jk}=\frac{1}{2}|\vec
D|^2-\frac{1}{2}|\vec B|^2.
\end{equation}

\subsubsection{Maxwell equations in covariant formulation}
%
%
%
%
%
%
In Section \ref{CVFRM} we prove that, since the lifted contravariant
tensor of the electromagnetic field $\{F^{ij}\}_{0\leq i,j\leq 3}$
on the group $\mathcal{S}_0$, considered in
\er{khjhhkfgjfjhgghhgjghjhjkkkkgjghghuiiiulkkjlkklKKgfgjhjjghgjhhjhjhhhhhghhgint}
is antisymmetric, then
the following four-component field:
\begin{equation}\label{MaxVacFullPPNhjjghjjkjhhoujiiikjjihjhiuiuinthhint}
\left\{\sum_{j=0}^{3}\frac{\partial F^{kj}}{\partial
x^j}+\sum_{j=0}^{3}\frac{F^{kj}}{\sqrt{|\text{det}\,G|}}\frac{\partial}{\partial
x^j}\left(\sqrt{|\text{det}\,G|}\right)\right\}_{0\leq k\leq 3}
\end{equation}
is a four-vector on the group $\mathcal{S}_0$, where the $4\times
4$-matrix $G$ is defined as $G:=\{g_{ij}\}_{0\leq i,j\leq 3}$.
Then, since the matrix $G$ satisfies $\text{det}\,G=-1$ in every
cartesian coordinate system,
then
\begin{equation}\label{MaxVacFullPPNhjjghjjkjhhoujiiikjjihjhiuiuint}
\left\{\sum_{j=0}^{3}\frac{\partial F^{kj}}{\partial
x^j}+\sum_{j=0}^{3}\frac{F^{kj}}{\sqrt{|\text{det}\,G|}}\frac{\partial}{\partial
x^j}\left(\sqrt{|\text{det}\,G|}\right)\right\}_{0\leq k\leq
3}=\left\{\sum_{j=0}^{3}\frac{\partial F^{kj}}{\partial
x^j}\right\}_{0\leq k\leq 3}.
\end{equation}
Note here that we denoted the matrix $G=\{g_{ij}\}_{0\leq i,j\leq
3}$ by the same letter as the Gravitational Constant $G$. However,
there is no ambiguity, since in the second case $G$ is a constant
scalar and in the first case $G$ is a matrix. Moreover, we will use
the matrix notation $G=\{g_{ij}\}_{0\leq i,j\leq 3}$ only in the
expressions containing term $\text{det}\,G$. Then, by
\er{khjhhkfgjfjhgghhgjghjhjkkkkgjghghuiiiulkkjlkklKKgfgjhjjghgjhhjhjhhhhhghhgint},
denoting $(x^0,x^1,x^2,x^3):=(ct,x_1,x_2,x_3)=(ct,\vec x)$, we
deduce:
%
%
%
%
%
%
\begin{equation}\label{khjhhkfgjfjhgghhgjghjhjkkkkgjghghuiiiulkkjlkklKKgfgjhjjghgjhhjhjhhhhhghhgtyytojjjjjint}
\left(\sum_{j=0}^{3}\frac{\partial F^{0j}}{\partial
x^j},\sum_{j=0}^{3}\frac{\partial F^{1j}}{\partial
x^j},\sum_{j=0}^{3}\frac{\partial F^{2j}}{\partial
x^j},\sum_{j=0}^{3}\frac{\partial F^{3j}}{\partial
x^j}\right)=\left(-div_{\vec x}\vec D,
\left(\frac{1}{c}\frac{\partial \vec D}{\partial t}-curl_{\vec
x}\vec H\right)\right).
\end{equation}
Therefore, by
\er{khjhhkfgjfjhgghhgjghjhjkkkkgjghghuiiiulkkjlkklKKgfgjhjjghgjhhjhjhhhhhghhgtyytojjjjjint},
the first pair of Maxwell Equations in
\er{MaxVacFull1bjkgjhjhgjaaaint}:
\begin{equation}\label{MaxVacFullPPNhjjghyghghiyyhhhjint}
\begin{cases}
curl_{\vec x} \vec H=\frac{4\pi}{c}\vec j+\frac{1}{c}\frac{\partial \vec D}{\partial t}\\
div_{\vec x} \vec D= 4\pi\rho,
\end{cases}
\end{equation}
is equivalent to the following equations:
\begin{equation}\label{khjhhkfgjfjhgghhgjghjhjkkkkgjghghuiiiulkkjlkklKKgfgjhjjghgjhhjhjhhhhhghhgtyytojjjjjuyiyint}
\left(\sum_{j=0}^{3}\frac{\partial F^{0j}}{\partial
x^j},\sum_{j=0}^{3}\frac{\partial F^{1j}}{\partial
x^j},\sum_{j=0}^{3}\frac{\partial F^{2j}}{\partial
x^j},\sum_{j=0}^{3}\frac{\partial F^{3j}}{\partial x^j}\right)
=-4\pi(j^0,j^1,j^2,j^3),
\end{equation}
where $(j^0,j^1,j^2,j^3)$ is the four-vector of electric current on
the group $\mathcal{S}_0$ defined by
\er{fgjfjhgghhgjghjhjijhojihjhjjijhjjjjjuiijjjklihhojjjoint} as:
\begin{equation}
\label{fgjfjhgghhgjghjhjijhojihjhjjijhjjjjjuiijjjklihhojjjoouuoiuiuint}
(j^0,j^1,j^2,j^3):=\left(\rho,\frac{1}{c}\,\vec j\right)
\end{equation}
Note that in both sides of equation
\er{khjhhkfgjfjhgghhgjghjhjkkkkgjghghuiiiulkkjlkklKKgfgjhjjghgjhhjhjhhhhhghhgtyytojjjjjuyiyint}
we have four-vectors and thus
\er{khjhhkfgjfjhgghhgjghjhjkkkkgjghghuiiiulkkjlkklKKgfgjhjjghgjhhjhjhhhhhghhgtyytojjjjjuyiyint}
is a covariant form of \er{MaxVacFullPPNhjjghyghghiyyhhhjint}. On
the other hand, the second pair of Maxwell Equations in
\er{MaxVacFull1bjkgjhjhgjaaaint}:
\begin{equation}\label{MaxVacFullPPNhjjghyghghiyyint}
\begin{cases}
curl_{\vec x} \vec E+\frac{1}{c}\frac{\partial \vec B}{\partial t}=0\\
div_{\vec x} \vec B=0,
\end{cases}
\end{equation}
is equivalent to
\er{MaxVacFull1bjkgjhjhgjgjgkjfhjfdghghligioiuittrPPNkkkint}, i.e.
to the following:
\begin{equation}\label{MaxVacFull1bjkgjhjhgjgjgkjfhjfdghghligioiuittrPPNkkkouiuiuuiuiint}
\begin{cases}
\vec B= curl_{\vec x} \vec A,\\
\vec E=-\nabla_{\vec x}\Psi-\frac{1}{c}\frac{\partial\vec
A}{\partial t},
\end{cases}
\end{equation}
On the other hand, as before, by
\er{huohuioy89gjjhjffffff3478zzrrZZZhjhhjhhjjhhffGGhjjhiuuijkjjkint}
we can rewrite
\er{MaxVacFull1bjkgjhjhgjgjgkjfhjfdghghligioiuittrPPNkkkouiuiuuiuiint}
in the form of
\er{huohuioy89gjjhjffffff3478zzrrZZZhjhhjhhjjhhffGGhjjhint}:
\begin{equation}\label{huohuioy89gjjhjffffff3478zzrrZZZhjhhjhhjjhhffGGhjjhuyuyuint}
F_{ij}=\frac{\partial A_j}{\partial x^i}-\frac{\partial
A_i}{\partial x^j}\quad\quad\forall\, i,j=0,1,2,3\,,
\end{equation}
where $(A_0,A_1,A_2,A_3)$ is the four-covector of the
electromagnetic potential on the group $\mathcal{S}_0$ defined by
\er{fgjfjhgghhgjghjhjijhojihjhjjijhjjjljljpkint} as:
\begin{equation}\label{fgjfjhgghhgjghjhjijhojihjhjjijhjjjljljpkyuuyyuint}
(A_0,A_1,A_2,A_3)=(\Psi,-\vec A).
\end{equation}
Note that in both sides of equation
\er{huohuioy89gjjhjffffff3478zzrrZZZhjhhjhhjjhhffGGhjjhuyuyuint} we
have two time covariant tensors, and thus
\er{huohuioy89gjjhjffffff3478zzrrZZZhjhhjhhjjhhffGGhjjhuyuyuint} is
a covariant form of \er{MaxVacFullPPNhjjghyghghiyyint}. Finally,
the relations between $(\vec E,\vec B)$ and $(\vec D,\vec H)$ in
\er{MaxVacFull1bjkgjhjhgjaaaint}:
\begin{equation}\label{MaxVacFullPPNhjjghojjkjkiouint}
\begin{cases}
\vec D=\vec E+\frac{1}{c}\,\vec v\times
\vec B\\
\vec H=\vec B+\frac{1}{c}\,\vec v\times \vec D,
\end{cases}
\end{equation}
are equivalent to the following covariant equations:
\begin{equation}\label{fgjfjhgghhgjghjhjkkkkgjghghuiiiulkkjlkklKKkjkjint}
F^{mn}:=\sum_{k=0}^{3}\sum_{j=0}^{3}g^{mj}g^{nk}F_{jk}
\quad\quad\forall\, m,n=0,1,2,3.
\end{equation}
Thus by
\er{huohuioy89gjjhjffffff3478zzrrZZZhjhhjhhjjhhffGGhjjhuyuyuint},
\er{fgjfjhgghhgjghjhjkkkkgjghghuiiiulkkjlkklKKkjkjint} and
\er{khjhhkfgjfjhgghhgjghjhjkkkkgjghghuiiiulkkjlkklKKgfgjhjjghgjhhjhjhhhhhghhgtyytojjjjjuyiyint}
together, we deduce that the full system of Maxwell Equations in
\er{MaxVacFull1bjkgjhjhgjaaaint}:
\begin{equation}\label{MaxVacFullPPNhjjghyghghiyyhhint}
\begin{cases}
curl_{\vec x} \vec H=\frac{4\pi}{c}\vec j+\frac{1}{c}\frac{\partial \vec D}{\partial t}\\
div_{\vec x} \vec D=4\pi\rho\\
curl_{\vec x} \vec E+\frac{1}{c}\frac{\partial \vec B}{\partial t}=0\\
div_{\vec x} \vec B=
0\\
\vec E=\vec D-\frac{1}{c}\,\vec v\times
\vec B\\
\vec H=\vec B+\frac{1}{c}\,\vec v\times \vec D,
\end{cases}
\end{equation}
is equivalent to the following covariant equations:
\begin{equation}\label{khjhhkfgjfjhgghhgjghjhjkkkkgjghghuiiiulkkjlkklKKgfgjhjjghgjhhjhjhhhhhghhgtyytojjjjjuyiyuuint}
\sum_{j=0}^{3}\frac{\partial}{\partial
x^j}\left(\sum_{m=0}^{3}\sum_{n=0}^{3}g^{km}g^{jn}\left(\frac{\partial
A_n}{\partial x^m}-\frac{\partial A_m}{\partial
x^n}\right)\right)=-4\pi j^k\quad\quad\forall\, k=0,1,2,3.
\end{equation}
Note that equations
\er{khjhhkfgjfjhgghhgjghjhjkkkkgjghghuiiiulkkjlkklKKgfgjhjjghgjhhjhjhhhhhghhgtyytojjjjjuyiyuuint}
are fully analogous to the covariant formulation of Maxwell
equations in Special Relativity and the only difference is the
choice of the pseudo-metric tensor $\{g^{ij}\}_{0\leq i,j\leq 3}$
(Note that for the Special Relativity case we also have
$\text{det}\,G=-1$). As for the cases of the General relativity, the
covariant formulation of Maxwell equations is still similar to
\er{khjhhkfgjfjhgghhgjghjhjkkkkgjghghuiiiulkkjlkklKKgfgjhjjghgjhhjhjhhhhhghhgtyytojjjjjuyiyuuint},
however, in addition to the different choice of the pseudo-metric
tensor $\{g^{ij}\}_{0\leq i,j\leq 3}$ we also have
$\text{det}\,G\neq Const.$ and thus for the full analogy equations
\er{khjhhkfgjfjhgghhgjghjhjkkkkgjghghuiiiulkkjlkklKKgfgjhjjghgjhhjhjhhhhhghhgtyytojjjjjuyiyuuint}
should be rewritten in the enlarged form, due to
\er{MaxVacFullPPNhjjghjjkjhhoujiiikjjihjhiuiuinthhint}:
\begin{multline}\label{khjhhkfgjfjhgghhgjghjhjkkkkgjghghuiiiulkkjlkklKKgfgjhjjghgjhhjhjhhhhhghhgtyytojjjjjuyiyuughgfgint}
\sum_{j=0}^{3}\frac{\partial}{\partial
x^j}\left(\sum_{m=0}^{3}\sum_{n=0}^{3}g^{km}g^{jn}\left(\frac{\partial
A_n}{\partial x^m}-\frac{\partial A_m}{\partial
x^n}\right)\right)+\\
\sum_{j=0}^{3}\frac{1}{\sqrt{|\text{det}\,G|}}\frac{\partial}{\partial
x^j}\left(\sqrt{|\text{det}\,G|}\right)\left(\sum_{m=0}^{3}\sum_{n=0}^{3}g^{km}g^{jn}\left(\frac{\partial
A_n}{\partial x^m}-\frac{\partial A_m}{\partial x^n}\right)\right)
=-4\pi j^k\quad\quad\forall\, k=0,1,2,3.
\end{multline}
Note also that we can rewrite
\er{khjhhkfgjfjhgghhgjghjhjkkkkgjghghuiiiulkkjlkklKKgfgjhjjghgjhhjhjhhhhhghhgtyytojjjjjuyiyuughgfgint}
as:
\begin{equation}\label{khjhhkfgjfjhgghhgjghjhjkkkkgjghghuiiiulkkjlkklKKgfgjhjjghgjhhjhjhhhhhghhgtyytojjjjjuyiyuughgfghhjint}
\sum_{j=0}^{3}\frac{\partial}{\partial
x^j}\left(\sum_{m=0}^{3}\sum_{n=0}^{3}\sqrt{|\text{det}\,G|}\,g^{km}g^{jn}\left(\frac{\partial
A_n}{\partial x^m}-\frac{\partial A_m}{\partial x^n}\right)\right)
=-4\pi \sqrt{|\text{det}\,G|}\, j^k\quad\quad\forall\, k=0,1,2,3.
\end{equation}

Next by
\er{MaxVacFullPPNhjjghjjkjhh1int} we have
\begin{equation}\label{MaxVacFullPPNhjjghjjkjhhiuyyint}
\frac{1}{2}|\vec D|^2-\frac{1}{2}|\vec
B|^2=-\sum_{j=0}^{3}\sum_{k=0}^{3}\frac{1}{4}F^{jk}F_{jk}.
\end{equation}
Therefore, by
\er{fgjfjhgghhgjghjhjijhojihjhjjijhjjjjjuiijjjklihhojjjoouuoiuiuint},
\er{fgjfjhgghhgjghjhjijhojihjhjjijhjjjljljpkyuuyyuint} and
\er{MaxVacFullPPNhjjghjjkjhhiuyyint}, we can rewrite the density of
the Lagrangian of the electromagnetic field, defined in
\er{vhfffngghkjgghPPNint} as
\begin{equation}\label{vhfffngghkjgghPPNggjgjjkgjoiuiint}
L_1\left(\vec A,\Psi,\vec
x,t\right):=\frac{1}{4\pi}\left(\frac{1}{2}\left|\vec
D\right|^2-\frac{1}{2}\left|\vec
B\right|^2-4\pi\left(\rho\Psi-\frac{1}{c}\vec A\cdot\vec
j\right)\right),
\end{equation}
in the equivalent covariant form:
\begin{multline}\label{vhfffngghkjgghPPNggjgjjkgjoiuioigghint}
L_1=\frac{1}{4\pi}\left(-\sum_{n=0}^{3}\sum_{k=0}^{3}\frac{1}{4}F^{nk}F_{nk}-\sum_{k=0}^{3}4\pi
j^k A_k\right)=\\
\frac{1}{4\pi}\left(-\sum_{n=0}^{3}\sum_{k=0}^{3}\sum_{m=0}^{3}\sum_{p=0}^{3}\frac{1}{4}g^{mn}g^{pk}\left(\frac{\partial
A_p}{\partial x^m}-\frac{\partial A_m}{\partial
x^p}\right)\left(\frac{\partial A_k}{\partial x^n}-\frac{\partial
A_n}{\partial x^k}\right)-\sum_{k=0}^{3}4\pi j^k A_k\right).
\end{multline}
The density of Lagrangian in
\er{vhfffngghkjgghPPNggjgjjkgjoiuioigghint} is also fully analogous
to the covariant formulation of the Lagrangian density of the
electromagnetic field in Special and General Relativity and the only
difference is the choice of the pseudo-metric tensor
$\{g^{ij}\}_{0\leq i,j\leq 3}$.

\subsubsection{Covariant formulation of Lagrangian of motion of a
classical charged particle in the external gravitational and
electromagnetic fields} Given a classical charged particle with
inertial mass $m$, charge $\sigma$, three-dimensional place $\vec
r(t)$ and three-dimensional velocity $\frac{d\vec r}{dt}$ in the
outer gravitational field with three-dimensional vectorial potential
$\vec v(\vec x,t)$, the outer electromagnetical field with
three-dimensional vectorial potential $\vec A(\vec x,t)$ and scalar
potential $\vec \Psi(\vec x,t)$, consider a usual Lagrangian that is
a particular case of \er{vhfffngghkjgghfjjint}:
\begin{equation}\label{vhfffngghkjgghfjjSYSPNkoijjhpoiint}
L_0\left(\frac{d\vec r}{dt},t\right):=
\left\{\frac{m}{2}\left|\frac{d\vec r}{dt}-\vec v(\vec
r,t)\right|^2-\sigma\left(\Psi(\vec r,t)-\frac{1}{c}\vec A(\vec
r,t)\cdot\frac{d\vec r}{dt}\right)\right\}.
\end{equation}
Then, since we are interesting in critical points of the functional
\begin{equation}\label{btfffygtgyggyijhhkkSYSPNuhuygyygyggyint}
J_0=\int_0^T L_0\left(\frac{d\vec r}{dt},\vec r,t\right)dt,
\end{equation}
adding a constant does not changes the physical meaning of the
Lagrangian and we can rewrite
\er{vhfffngghkjgghfjjSYSPNkoijjhpoiint} as:
\begin{equation}\label{vhfffngghkjgghfjjSYSPNkoijjhpoiuuiint}
L'_0\left(\frac{d\vec r}{dt},t\right):=
\left\{\left(\frac{m}{2}\left|\frac{d\vec r}{dt}-\vec v(\vec
r,t)\right|^2-\frac{mc^2}{2}\right)-\sigma\left(\Psi(\vec
r,t)-\frac{1}{c}\vec A(\vec r,t)\cdot\frac{d\vec
r}{dt}\right)\right\}.
\end{equation}
and \er{btfffygtgyggyijhhkkSYSPNuhuygyygyggyint} as
\begin{multline}\label{btfffygtgyggyijhhkkSYSPNuhuygyygyggyuyyint}
J'_0:=J_0-\frac{Tmc^2}{2}=\int_0^T L'_0\left(\frac{d\vec r}{dt},\vec
r,t\right)dt=\\ \int_0^T\left\{\left(\frac{m}{2}\left|\frac{d\vec
r}{dt}-\vec v(\vec
r,t)\right|^2-\frac{mc^2}{2}\right)-\sigma\left(\Psi(\vec
r,t)-\frac{1}{c}\vec A(\vec r,t)\cdot\frac{d\vec
r}{dt}\right)\right\}dt,
\end{multline}
Next consider the four-vector field of the momentum on the group
$\mathcal{S}_0$: $\left(p^0(t),p^1(t),p^2(t),p^3(t)\right)$, defined
by \er{fgjfjhgghhgjghjhjijhojihjhjjijhjjjjjuiijhjhhint} and
\er{fgjfjhgghhgjghjhjijhojihjhjjijhjjjjjuiijjjklihhint} as:
\begin{equation}\label{fgjfjhgghhgjghjhjijhojihjhjjijhjjjjjuiijhjhhioiint}
\left(
p^0(t),p^1(t),p^2(t),p^3(t)\right):=\left(m,\frac{m}{c}\frac{d\vec
r}{dt}(t)\right)=
\left(m,\frac{m}{c}\frac{dr_1}{dt}(t),\frac{m}{c}\frac{dr_2}{dt}(t),\frac{m}{c}\frac{dr_3}{dt}(t)\right)
\end{equation}
Then by
\er{fgjfjhgghhgjghjhjkkkkgjghghuiiiulkkjKKyuyyu0ioioiogghghghgghghint}
we have
\begin{multline}\label{fgjfjhgghhgjghjhjkkkkgjghghuiiiulkkjKKyuyyu0ioioiogghghghgghghhghgvint}
\frac{mc^2}{2}\left(\frac{1}{c^2}\left|\frac{d\vec r}{dt}-\vec
v(\vec r,t)\right|^2-1\right)=\frac{m}{2}\left|\frac{d\vec
r}{dt}-\vec v(\vec
r,t)\right|^2-\frac{mc^2}{2}\\=-\frac{c^2}{2m}\left(\sum_{k=0}^{3}p^kp_k\right)=-\frac{mc^2}{2}\left(\sum_{j=0}^{3}\sum_{k=0}^{3}g_{jk}(\vec
r,t)\,\frac{p^j}{m}\,\frac{p^k}{m}\right).
\end{multline}
On the other hand if we consider the four-covector of the
electromagnetic potential on the group $\mathcal{S}_0$:
$(A_0,A_1,A_2,A_3)$, defined by
\er{fgjfjhgghhgjghjhjijhojihjhjjijhjjjljljpkint} as:
\begin{equation}\label{fgjfjhgghhgjghjhjijhojihjhjjijhjjjljljpkyuuyyuhhhhjint}
(A_0,A_1,A_2,A_3)=(\Psi,-\vec A),
\end{equation}
then we can write,
\begin{equation}\label{btfffygtgyggyijhhkkSYSPNuhuygyygyggyuyyioiooiihhhjint}
\sigma\left(\Psi(\vec r,t)-\frac{1}{c}\vec A(\vec
r,t)\cdot\frac{d\vec r}{dt}\right)=\sum_{k=0}^{3}\sigma A_k(\vec
r,t)\,\frac{p^k}{m}.
\end{equation}
Thus by
\er{fgjfjhgghhgjghjhjkkkkgjghghuiiiulkkjKKyuyyu0ioioiogghghghgghghhghgvint}
and \er{btfffygtgyggyijhhkkSYSPNuhuygyygyggyuyyioiooiihhhjint} we
rewrite \er{btfffygtgyggyijhhkkSYSPNuhuygyygyggyuyyint} in a
covariant form:
\begin{multline}\label{btfffygtgyggyijhhkkSYSPNuhuygyygyggyuyyuyuyint}
J'_0=\int_0^T L'_0\left(\frac{d\vec r}{dt},\vec r,t\right)dt=
\int_0^T\left\{-\frac{mc^2}{2}\left(\sum_{j=0}^{3}\sum_{k=0}^{3}g_{jk}(\vec
r,t)\,\frac{p^j}{m}\,\frac{p^k}{m}\right)-\sum_{k=0}^{3}\sigma
A_k(\vec r,t)\,\frac{p^k}{m}\right\}dt.
\end{multline}
Thus if we consider the four-dimensional space-time trajectory of
the particle:
\begin{equation}\label{btfffjhgjghghint}
\left(\chi^0(t),\chi^1(t),\chi^2(t),\chi^3(t)\right)=\left(t,\frac{1}{c}r_1(t),\frac{1}{c}r_2(t),\frac{1}{c}r_3(t)\right),
\end{equation}
then we rewrite \er{btfffygtgyggyijhhkkSYSPNuhuygyygyggyuyyuyuyint}
as:
\begin{equation}\label{btfffygtgyggyijhhkkSYSPNuhuygyygyggyuyyuyuykuhghgint}
J'_0=
\int_0^T\left\{-\frac{mc^2}{2}\left(\sum_{j=0}^{3}\sum_{k=0}^{3}g_{jk}\left(\chi(t)\right)\,\frac{d\chi^j}{dt}\,\frac{d\chi^k}{dt}\right)-\sum_{k=0}^{3}\sigma
A_k\left(\chi(t)\right)\,\frac{d\chi^k}{dt}\right\}dt.
\end{equation}
Moreover, $\left(\frac{d\chi^0}{dt}, \frac{d\chi^1}{dt},
\frac{d\chi^2}{dt}, \frac{d\chi^3}{dt}\right)$ is a four-vector on
the group $\mathcal{S}_0$ and the global non-relativistic time $t$
is the scalar on the group $\mathcal{S}_0$.

Next we also can consider a more general Lagrangian than
\er{btfffygtgyggyijhhkkSYSPNuhuygyygyggyuyyuyuykuhghgint}: given a
function $\mathcal{G}(\tau):\mathbb{R}\to\mathbb{R}$ define:
\begin{equation}\label{btfffygtgyggyijhhkkSYSPNuhuygyygyggyuyyuyuykuhghgjjojint}
J_{\mathcal{G}}(\chi)=
\int_0^T\left\{-mc^2\;\mathcal{G}\left(\sum_{j=0}^{3}\sum_{k=0}^{3}g_{jk}\left(\chi(t)\right)\,\frac{d\chi^j}{dt}\,\frac{d\chi^k}{dt}\right)-\sum_{k=0}^{3}\sigma
A_k\left(\chi(t)\right)\,\frac{d\chi^k}{dt}\right\}dt.
\end{equation}
Clearly,
\er{btfffygtgyggyijhhkkSYSPNuhuygyygyggyuyyuyuykuhghgjjojint} is
written in covariant form, and in particular,
\er{btfffygtgyggyijhhkkSYSPNuhuygyygyggyuyyuyuykuhghgjjojint} is
invariant under the change of non-inertial cartesian coordinate
systems. In particular, for $\mathcal{G}(\tau):=\frac{1}{2}\tau$ we
obtain \er{btfffygtgyggyijhhkkSYSPNuhuygyygyggyuyyuyuykuhghgint}.

Another important particular case is the following choice:
$\mathcal{G}(\tau):=\sqrt{\tau}$. Then we deduce:
\begin{equation}\label{btfffygtgyggyijhhkkSYSPNuhuygyygyggyuyyuyuykuhghgjjojiyyint}
J_{rl}(\chi)=
\int_0^T\left\{-mc^2\;\sqrt{\left(\sum_{j=0}^{3}\sum_{k=0}^{3}g_{jk}\left(\chi(t)\right)\,\frac{d\chi^j}{dt}\,\frac{d\chi^k}{dt}\right)}-\sum_{k=0}^{3}\sigma
A_k\left(\chi(t)\right)\,\frac{d\chi^k}{dt}\right\}dt,
\end{equation}
that is in somewhat analogous to the relativistic Lagrangian of the
motion of charged particle. Due to \er{btfffjhgjghghint} we rewrite
\er{btfffygtgyggyijhhkkSYSPNuhuygyygyggyuyyuyuykuhghgjjojiyyint} in
a three-dimensional form as:
\begin{equation}\label{btfffygtgyggyijhhkkSYSPNuhuygyygyggyuyyuyuykuhghgjjojiyyyuyuuyyint}
J_{rl}(\vec r)= \int_0^T\left\{
-mc^2\sqrt{1-\frac{1}{c^2}\left|\frac{d\vec r}{dt}-\vec v(\vec
r,t)\right|^2}-\sigma\left(\Psi(\vec r,t)-\frac{1}{c}\vec A(\vec
r,t)\cdot\frac{d\vec r}{dt}\right)\right\}dt.
\end{equation}
Thus in the case
$$\frac{1}{c^2}\left|\frac{d\vec r}{dt}-\vec v(\vec
r,t)\right|^2\ll 1,$$ up to additive constant,
\er{btfffygtgyggyijhhkkSYSPNuhuygyygyggyuyyuyuykuhghgjjojiyyyuyuuyyint}
becomes to be \er{btfffygtgyggyijhhkkSYSPNuhuygyygyggyint}, where
$L_0$ is given by \er{vhfffngghkjgghfjjSYSPNkoijjhpoiint}. Note that
the Lagrangian in
\er{btfffygtgyggyijhhkkSYSPNuhuygyygyggyuyyuyuykuhghgjjojiyyint} has
the following advantage with respect to
\er{btfffygtgyggyijhhkkSYSPNuhuygyygyggyuyyuyuykuhghgint}: if we
parameterize the curve in \er{btfffjhgjghghint} by some arbitrary
parameter $s$ with increasing mapping $t\leftrightarrow s$, which is
however can differ from the global time $t$, then changing variables
of integration in
\er{btfffygtgyggyijhhkkSYSPNuhuygyygyggyuyyuyuykuhghgjjojiyyint}
from $t$ to $s$ gives:
\begin{equation}\label{btfffygtgyggyijhhkkSYSPNuhuygyygyggyuyyuyuykuhghgjjojiyyjljlghint}
J_{rl}(\chi)=
\int_a^b\left\{-mc^2\;\sqrt{\left(\sum_{j=0}^{3}\sum_{k=0}^{3}g_{jk}\left(\chi(s)\right)\,\frac{d\chi^j}{ds}\,\frac{d\chi^k}{ds}\right)}-\sum_{k=0}^{3}\sigma
A_k\left(\chi(s)\right)\,\frac{d\chi^k}{ds}\right\}ds,
\end{equation}
that has exactly the same form as
\er{btfffygtgyggyijhhkkSYSPNuhuygyygyggyuyyuyuykuhghgjjojiyyint},
however $s$ in
\er{btfffygtgyggyijhhkkSYSPNuhuygyygyggyuyyuyuykuhghgjjojiyyjljlghint}
can be \underline{arbitrary} parameter of the curve with increasing
mapping $t\leftrightarrow s$.

Finally, we would like to note that if the motion of some particle
is ruled by the relativistic-like Lagrangian in
\er{btfffygtgyggyijhhkkSYSPNuhuygyygyggyuyyuyuykuhghgjjojiyyyuyuuyyint},
then, although the absolute value of the velocity of the particle
$\left|\frac{d\vec r}{dt}\right|$ can be arbitrary large, the
absolute value of the difference between the velocity of the
particle and the local gravitational potential cannot exceed the
value $c$, i.e.:
\begin{equation}\label{btfffygtgyggyijhhkkSYSPNuhuygyygyggyuyyuyuykuhghgjjojiyyyuyuuyyijyyuyuint}
\left|\vec u(t)-\vec v(\vec r,t)\right|:=\left|\frac{d\vec
r}{dt}-\vec v(\vec r,t)\right|\,<\,c\quad\quad\quad\quad\forall\, t,
\end{equation}
provided that
\er{btfffygtgyggyijhhkkSYSPNuhuygyygyggyuyyuyuykuhghgjjojiyyyuyuuyyijyyuyuint}
is satisfied in some initial instant of time. Note also that the
quantity in the left hand side of
\er{btfffygtgyggyijhhkkSYSPNuhuygyygyggyuyyuyuykuhghgjjojiyyyuyuuyyijyyuyuint}
is invariant under the change of inertial or non-inertial cartesian
coordinate system.

\subsubsection{Kinematic pseudo-metric tensors of
inertia}\label{jjjjjkkk88int} Consider $\{J^{ij}\}_{0\leq i,j\leq
3}$ to be a two times contravariant tensor field on the group
$\mathcal{S}_0$, defined by
\begin{equation}\label{hoyuiouigyfghgjh3478zzrrZZffGGjkkjojj88int}
J^{ij}:=k^ik^j-{\Theta}^{ij}\quad\quad\forall\,i,j=0,1,2,3,
\end{equation}
where $\{{\Theta}^{ij}\}_{0\leq i,j\leq 3}$ is the contravariant
tensor of the three-dimensional geometry, defined by
\er{huohuioy89gjjhjffffff3478zzrrZZZhjhhjhhjjhhffGGhjjhuiiuihhjkkoioiint}
and being a two times contravariant tensor, and $(k^0,k^1,k^2,k^3)$
is the four-dimensional potential of inertia, defined by
\er{fgjfjhgghhgjghjhjijhojihjhjjijhjjjjjuiijjjkjkkhhkkhint} and
being a four-vector. Then
in subsection \ref{jjjjjkkk88} we obtain that $\{J^{ij}\}_{0\leq
i,j\leq 3}$ is indeed a two times contravariant tensor field on the
group $\mathcal{S}_0$ and moreover, this tensor is symmetric.
Moreover, by
\er{huohuioy89gjjhjffffff3478zzrrZZZhjhhjhhjjhhffGGhjjhuiiuihhjkkoioiint}
and \er{fgjfjhgghhgjghjhjijhojihjhjjijhjjjjjuiijjjkjkkhhkkhint} we
have:
\begin{equation}\label{hoyuiouigyfghgjh3478zzrrZZffGGjkkj88int}
\begin{cases}
J^{00}=1\\
J^{ij}=-\delta_{ij}+\frac{k^ik^j}{c^2}\quad\forall 1\leq i,j\leq 3\\
J^{0j}=J^{j0}=\frac{k^j}{c}\quad\forall 1\leq j\leq 3,
\end{cases}
\end{equation}
where $\vec k=(k^1,k^2,k^3)$ is the three-dimensional vectorial
potential of inertia. We call the tensor $\{J^{ij}\}_{0\leq i,j\leq
3}$ the contravariant kinematic pseudo-metric tensor of inertia.
Next consider a $16$-component field $\{J_{ij}\}_{0\leq i,j\leq 3}$
defined by
\begin{equation}\label{hoyuiouigyfg3478zzrrZZffGGhhjhj88int}
\begin{cases}
J_{00}=1-\frac{|\vec k|^2}{c^2}\\
J_{ij}=-\delta_{ij}\quad\forall 1\leq i,j\leq 3\\
J_{0j}=J_{j0}=\frac{k^j}{c}\quad\forall 1\leq j\leq 3.
\end{cases}
\end{equation}
Then, as before in subsection \ref{jjjjjkkk88int}, we deduce
\begin{equation}\label{fgjfjhgghhgjghjhjkkkkgjghghuiiiuujhjh88int}
\sum_{m=0}^{3}J^{im}J_{mj}=\begin{cases}
1\quad\text{if}\quad i=j\\
0\quad\text{if}\quad i\neq j
\end{cases}\quad\quad\forall\, i,j=0,1,2,3.
\end{equation}
Therefore,
we obtain that $\{J_{ij}\}_{i,j=0,1,2,3}$ is a two times covariant
tensor on the group $\mathcal{S}_0$, and moreover, this tensor is
symmetric. We call the tensor $\{J_{ij}\}_{0\leq i,j\leq 3}$
covariant kinematic pseudo-metric tensors of inertia. Using
\er{fgjfjhgghhgjghjhjkkkkgjghghuiiiuujhjh88int} we also obtain that
the pseudo-metric tensors $\{J_{ij}\}_{i,j=0,1,2,3}$ and
$\{J^{ij}\}_{0\leq i,j\leq 3}$ are non-degenerate and they are
inverse of each other. Moreover, as before, it can be easily
calculated that if we consider the $4\times 4$-matrix:
\begin{equation}\label{fgjfjhgghhgjghjhjkkkkgjghghuiiiuujhjhjklj88int}
J=\{J_{ij}\}_{0\leq i,j\leq 3},
\end{equation}
then
\begin{equation}\label{fgjfjhgghhgjghjhjkkkkgjghghuiiiuujhjh188int}
\text{det}\,J=-1.
\end{equation}
In particular, by \er{fgjfjhgghhgjghjhjkkkkgjghghuiiiuujhjh188int}
and \er{fgjfjhgghhgjghjhjkkkkgjghghuiiiuujhjh1int} we deduce
\begin{equation}\label{fgjfjhgghhgjghjhjkkkkgjghghuiiiuujhjhljhgugii88int}
\text{det}\,G=\text{det}\,J.
\end{equation}
where $4\times 4$-matrix $G$ is given by,
\begin{equation}\label{fgjfjhgghhgjghjhjkkkkgjghghuiiiuujhjhjkljojuuiyi88int}
G=\{g_{ij}\}_{0\leq i,j\leq 3},
\end{equation}
with the covariant pseudo-metric tensor of the four-dimensional
space-time $\{g_{ij}\}_{0\leq i,j\leq 3}$, given by
\er{hoyuiouigyfg3478zzrrZZffGGhhjhjint}. Next, obviously we have
\begin{equation}\label{hoyuiouigyfghgjh3478zzrrZZffGGjkkjojj88iihhkhk88int}
{g}^{ij}=J^{ij}+\left(v^iv^j-k^ik^j\right)\quad\quad\forall\,i,j=0,1,2,3,
\end{equation}
where $J^{ij}$ the is the contravariant kinematic pseudo-metric
tensor of inertia, given by
\er{hoyuiouigyfghgjh3478zzrrZZffGGjkkjojj88int}, and $g^{ij}$ is the
contravariant pseudo-metric tensor of the four-dimensional
space-time, given by \er{hoyuiouigyfghgjh3478zzrrZZffGGjkkjojjint}.
In particular, in the case of the absence of the genuine gravity
where $\vec v=\vec k$ we have
\begin{equation}\label{hoyuiouigyfghgjh3478zzrrZZffGGjkkjojj88iihhkhk88jouyiyi88int}
{g}^{ij}=J^{ij}\quad\quad\forall\,i,j=0,1,2,3,
\end{equation}
or equivalently,
\begin{equation}\label{hoyuiouigyfghgjh3478zzrrZZffGGjkkjojj88iihhkhk88jouyiyi88joljljkj88int}
{g}_{ij}=J_{ij}\quad\quad\forall\,i,j=0,1,2,3.
\end{equation}

Furthermore, by \er{hoyuiouigyfghgjh3478zzrrZZffGGjkkj88int} we have
\begin{equation}\label{fgjfjhgghhgjghjhjijhojihjhjjijhjjjjjuiijjjkhjhjhjuiiuuuyu88jukukukuk88kpjljlint}
\sum_{m=0}^{3} \,J^{jm}\,\left(c\,\frac{\partial\tau}{\partial
x^m}\right)=k^j\quad\quad\forall\,j=0,1,2,3.
\end{equation}
where $(k^0,k^1,k^2,k^3)$ is the four-dimensional potential of
inertia, defined by
\er{fgjfjhgghhgjghjhjijhojihjhjjijhjjjjjuiijjjkjkkhhkkhint} and
$\tau$ is the scalar of the global time on the group
$\mathcal{S}_0$, defined by \er{uguyytfddddgghkjhhjuuiuiint}. Thus
since,
\begin{equation}\label{fgjfjhgghhgjghjhjijhojihjhjjijhjjjjjuiijjjkhjhjhjuiiuuuyu88jukukukuk88kpjljlkjojint}
\sum_{j=0}^{3} \,k^j\,\left(c\,\frac{\partial\tau}{\partial
x^j}\right)=1,
\end{equation}
we deduce the following eikonal type equation
\begin{equation}\label{fgjfjhgghhgjghjhjijhojihjhjjijhjjjjjuiijjjkhjhjhjuiiuuuyu88int}
c^2\left(\sum_{j=0}^{3}\sum_{m=0}^{3}\,J^{jm}\frac{\partial\tau}{\partial
x^j}\frac{\partial\tau}{\partial x^m}\right)=1,
\end{equation}
which is similar to
\er{fgjfjhgghhgjghjhjijhojihjhjjijhjjjjjuiijjjkhjhjhjuiiuuuyuint},
despite of the different pseudometrics. Moreover, by
\er{fgjfjhgghhgjghjhjijhojihjhjjijhjjjjjuiijjjkhjhjhjuiiuuuyu88jukukukuk88kpjljlint}
and \er{fgjfjhgghhgjghjhjkkkkgjghghuiiiuujhjh88int} we infer:
\begin{equation}\label{fgjfjhgghhgjghjhjijhojihjhjjijhjjjjjuiijjjkhjhjhjuiiuuuyu88jukukukuk88kpjljlmhmbbvnvvvkljljl88int}
\sum_{m=0}^{3} \,J_{jm}\,k^m\,=\,c\,\frac{\partial\tau}{\partial
x^j}\quad\quad\forall\,j=0,1,2,3.
\end{equation}

 Next, as before, note that  the Kinematic pseudo-metric tensors of inertia $\{J^{ij}\}_{0\leq i,j\leq 3}$ and $\{J_{ij}\}_{0\leq i,j\leq 3}$  depend only on
the coordinate system in the space-time and are completely
independent of the physical matter or physical fields filling this
space. In contrast the pseudo-metric tensors of the four-dimensional
space-time $\{g^{ij}\}_{0\leq i,j\leq 3}$ and $\{g_{ij}\}_{0\leq
i,j\leq 3}$, given by \er{hoyuiouigyfghgjh3478zzrrZZffGGjkkjint} and
\er{hoyuiouigyfg3478zzrrZZffGGhhjhjint}, depend essentially on the
surrounding physical matter (in the model of the Newtonian gravity
through gravitational masses). Furthermore, note that in the absence
of the genuine gravity $\{g^{ij}\}_{0\leq i,j\leq 3}$ and
$\{g_{ij}\}_{0\leq i,j\leq 3}$ coincide with $\{J^{ij}\}_{0\leq
i,j\leq 3}$ and $\{J_{ij}\}_{0\leq i,j\leq 3}$ respectively. In the
model of the Newtonian gravity this happens away from essential
gravitational masses. In particular, $\{g^{ij}\}_{0\leq i,j\leq 3}$
and $\{g_{ij}\}_{0\leq i,j\leq 3}$ tend to $\{J^{ij}\}_{0\leq
i,j\leq 3}$ and $\{J_{ij}\}_{0\leq i,j\leq 3}$ respectively, as
$|\vec x|\to +\infty$. Finally note that since in every inertial
cartesian coordinate system $\vec k$ is a constant we deduce that in
every such a system $\{J^{ij}\}_{0\leq i,j\leq 3}$ and
$\{J_{ij}\}_{0\leq i,j\leq 3}$ are \underline{constant} in the
four-dimensional space-time. So, in contrast to $\{g^{ij}\}_{0\leq
i,j\leq 3}$ and $\{g_{ij}\}_{0\leq i,j\leq 3}$, the pseudo-metrics
$\{J^{ij}\}_{0\leq i,j\leq 3}$ and $\{J_{ij}\}_{0\leq i,j\leq 3}$
are complectly \underline{flat}. In particular, if we consider the
Christoffel Symbols of the Kinematic pseudo-metric tensors of
inertia $\{J_{ij}\}_{0\leq i,j\leq 3}$ defined by,
\begin{equation}\label{jhffhgffgh3478zzrrZZffjjkjkjklkkj88int}
\begin{cases}
\left\{\Gamma_{i,mn}\right\}_J:=\frac{1}{2}\left(\frac{\partial
J_{im}}{\partial x_{n}}+\frac{\partial J_{in}}{\partial
x_{m}}-\frac{\partial
J_{mn}}{\partial x_{i}}\right)\\
\left\{\Gamma^{i}_{mn}\right\}_J:=\sum_{j=0}^{3}J^{ij}\left\{\Gamma_{j,mn}\right\}_J
\end{cases}\quad\quad\forall\, i,m,n=0,1,2,3,
\end{equation}
then, since in every inertial cartesian coordinate system
$\{J_{ij}\}_{0\leq i,j\leq 3}$ is \underline{constant} in the
four-dimensional space-time, we deduce that in every such coordinate
system we have
\begin{equation}\label{jhffhgffghhjgj3478zzrrZZffkhkhkljjllkjl88int}
\left\{\Gamma^{i}_{mn}\right\}_J=\left\{\Gamma_{i,mn}\right\}_J=0\quad\quad\forall\,
i,m,n=0,1,2,3.
\end{equation}
Thus, if we consider the two times covariant tensor of the covariant
derivative to the covector of the gradient to the scalar of the
global time $\tau$ from \er{uguyytfddddgghkjhhjuuiuiint}, denoted by
$\left\{\delta_j \left(\frac{\partial\tau}{\partial x_i}
\right)\right\}_J$, and defined by:
\begin{multline}\label{hoyuiouigyfghgjh3478zzrrZZjhjhugyuyughggjhjhjyuuygtyffjjjkjljkjjljokjllllklljkljkjlnnnljjl88int}
\left\{\delta_j \left(\frac{\partial\tau}{\partial x_i}
\right)\right\}_J=\left\{\delta_i \left(\frac{\partial\tau}{\partial
x_j} \right)\right\}_J:=\frac{\partial^2\tau}{\partial x_i\partial
x_j}-\sum_{m=0}^{3}\left\{\Gamma^{m}_{ij}\right\}_J\,\frac{\partial
\tau}{\partial x_m}
\\=\frac{\partial^2\tau}{\partial x_i\partial
x_j}-\sum_{m=0}^{3}\left\{\Gamma_{m,ij}\right\}_J\,
\frac{k^m}{c}\quad\quad\forall\, i,j=0,1,2,3\,,
\end{multline}
then by
\er{fgjfjhgghhgjghjhjkkkkgjghghuiiiulkkjlkklplikklkjljghhgghkint}
and \er{jhffhgffghhjgj3478zzrrZZffkhkhkljjllkjl88int} we prove the
following identity, first in every inertial cartesian coordinate
system and then, by the covariance of this identity, also in every
non-inertial cartesian coordinate system:
\begin{equation}\label{hoyuiouigyfghgjh3478zzrrZZjhjhugyuyughggjhjhjyuuygtyffjjjkjljkjjljokjllllklljkljkjlnnnhkhkhkhkhh88int}
\left\{\delta_j \left(\frac{\partial\tau}{\partial x_i}
\right)\right\}_J=0\quad\quad\forall\, i,j=0,1,2,3\,.
\end{equation}
Note here that we put brackets $\left\{\cdot\right\}_J$ for the
Christoffel Symbols and covariant derivative with respect to
Kinematical pseudo-metrics  $\{J^{ij}\}_{0\leq i,j\leq 3}$ and
$\{J_{ij}\}_{0\leq i,j\leq 3}$ in order to distinguish them from
Christoffel Symbols and covariant derivative with respect to
Dynamical pseudo-metrics  $\{g^{ij}\}_{0\leq i,j\leq 3}$ and
$\{g_{ij}\}_{0\leq i,j\leq 3}$, that we will write
\underline{without} any brackets.

Furthermore, since $\vec k$ is \underline{generally trivial}
speed-like vector field, by the definition, there exists a unique,
up to equivalence, "preferable" inertial cartesian coordinate system
($\{0\}$) where we have
\begin{equation}\label{fgjfjhgghhgjghjhjijhojihjhjjijhjjjjjuiijjjkhjhjhjuiiuuuyu88jkhmbmbm88int}
\vec k=0,
\end{equation}
and inserting it into \er{hoyuiouigyfghgjh3478zzrrZZffGGjkkj88int}
and \er{hoyuiouigyfg3478zzrrZZffGGhhjhj88int} we deduce that in
system ($\{0\}$) we have
\begin{equation}\label{hoyuiouigyfghgjh3478zzrrZZffGGjkkj88jokhjh88int}
\begin{cases}
J^{00}=1\\
J^{ij}=-\delta_{ij}\quad\forall 1\leq i,j\leq 3\\
J^{0j}=J^{j0}=0\quad\forall 1\leq j\leq 3,
\end{cases}
\end{equation}
and
\begin{equation}\label{hoyuiouigyfghgjh3478zzrrZZffGGjkkj88jokhjh88jkhkjhjh88int}
\begin{cases}
J_{00}=1\\
J_{ij}=-\delta_{ij}\quad\forall 1\leq i,j\leq 3\\
J_{0j}=J_{j0}=0\quad\forall 1\leq j\leq 3,
\end{cases}
\end{equation}
which is the same as the basic pseudo-metrics in the Special
Relativity.

Next, we define the Dynamical four-covector of genuine gravity
$(s_0,s_1,s_2,s_3)$ by the following:
\begin{equation}\label{huohuioy89gjjhjffffff3478zzrrZZZhjhhjhhjjhhffGGhjjhuiiuihhjkkoioirrkhrrjggggjrrjiokuukuppkpklhjjhj88int}
s_j=\frac{1}{2}\left(\sum_{m=0}^{3} g_{jm}k^m-\sum_{m=0}^{3}
J_{jm}v^m\right)\quad \forall j=0,1,2,3,
\end{equation}
where $(k^0,k^1,k^2,k^3)$ is the four-dimensional potential of
inertia, defined by
\er{fgjfjhgghhgjghjhjijhojihjhjjijhjjjjjuiijjjkjkkhhkkhint},
$(v^0,v^1,v^2,v^3)$ is the four-dimensional gravitational potential,
defined by \er{fgjfjhgghhgjghjhjijhojihjhjjijhjjjjjuiijjjkint},
$\{g_{ij}\}_{0\leq i,j\leq 3}$ defined by
\er{hoyuiouigyfg3478zzrrZZffGGhhjhjint} and $\{J_{ij}\}_{0\leq
i,j\leq 3}$ defined by \er{hoyuiouigyfg3478zzrrZZffGGhhjhj88int}.
Then, by inserting
\er{fgjfjhgghhgjghjhjijhojihjhjjijhjjjjjuiijjjkjkkhhkkhint},
\er{fgjfjhgghhgjghjhjijhojihjhjjijhjjjjjuiijjjkint},
\er{hoyuiouigyfg3478zzrrZZffGGhhjhjint} and
\er{hoyuiouigyfg3478zzrrZZffGGhhjhj88int} into
\er{huohuioy89gjjhjffffff3478zzrrZZZhjhhjhhjjhhffGGhjjhuiiuihhjkkoioirrkhrrjggggjrrjiokuukuppkpklhjjhj88int}
we obtain
%
%
%
%
%
%
%
%
\begin{multline}\label{fgjfjhgghhgjghjhjijhojihjhjjijhjjjjjuiijjjk88int}
(s_0,s_1,s_2,s_3)=\left(-\frac{1}{2c^2}|\vec h|^2-\frac{1}{c^2}\vec
k\cdot\vec h\,,\,\frac{1}{c}\vec h\right)=\left(\frac{1}{2c^2}|\vec
h|^2-\frac{1}{c^2}\vec v\cdot\vec h\,,\,\frac{1}{c}\vec
h\right)\quad\quad\text{where}\quad\\ s_0=-\frac{1}{2c^2}|\vec
h|^2-\frac{1}{c^2}\vec k\cdot\vec h=\frac{1}{2c^2}|\vec
h|^2-\frac{1}{c^2}\vec v\cdot\vec
h\;\;\quad\text{and}\quad\;\;(s_1,s_2,s_3)=\frac{1}{c}\vec h,
\end{multline}
where $\vec h$ is the three-dimensional proper vector field of the
vectorial potential of genuine gravity defined as in
\er{jljjjjkhhhjhjhjhljjjljljkjkCChhhhint} by
\begin{equation}\label{huohuioy89gjjhjffffff3478zzrrZZZhjhhjhhjjhhffGGhjjhuiiuihhjkkoioirrkhrrjggggjrrjiokuukuppkpklhjjhjljoyhkiogu88int}
\vec h:=\vec v-\vec k,
\end{equation}
(see subsection \ref{jjjjjkkk88} for details). In particular, by
\er{hoyuiouigyfg3478zzrrZZffGGhhjhjint},
\er{hoyuiouigyfg3478zzrrZZffGGhhjhj88int},
\er{fgjfjhgghhgjghjhjijhojihjhjjijhjjjjjuiijjjk88int} and
\er{fgjfjhgghhgjghjhjkkkkgjghghuiiiulkkjlkklplikklkjljghhgghkint}
together we deduce:
\begin{equation}\label{hoyuiouigyfghgjh3478zzrrZZffGGjkkjojj88iihhkhk88jklhkhjkhk88int}
{g}_{ij}=J_{ij}+\left(s_i\left(c\,\frac{\partial\tau}{\partial
x^j}\right)+\left(c\,\frac{\partial\tau}{\partial
x^i}\right)s_j\right)\quad\quad\forall\,i,j=0,1,2,3.
\end{equation}
Moreover, by \er{fgjfjhgghhgjghjhjijhojihjhjjijhjjjjjuiijjjk88int}
\er{huohuioy89gjjhjffffff3478zzrrZZZhjhhjhhjjhhffGGhjjhuiiuihhjkkoioiint}
and \er{hoyuiouigyfghgjh3478zzrrZZffGGjkkjojj88int} we obtain:
\begin{equation}\label{hoyuiouigyfghgjh3478zzrrZZffGGjkkjojj88jklkhhk88khjhgh88int}
v^j-k^j=\sum_{m=0}^{3}{\Theta}^{jm}s_m=\sum_{m=0}^{3}\left(k^jk^m-J^{jm}\right)s_m\quad\quad\forall\,j=0,1,2,3.
\end{equation}

\subsubsection{Physical laws in curvilinear coordinate systems in the
non-relativistic space-time}
Let $\mathcal{S}$ be the group of all smooth non-degenerate
invertible transformations from $\mathbb{R}^4$ onto $\mathbb{R}^4$
having the form \er{fgjfjhgghyuyyuint}:
\begin{equation}\label{fgjfjhgghyuyyuyughgint}
\begin{cases}
x'^0=f^{(0)}(x^0,x^1,x^2,x^3),\\
x'^1=f^{(1)}(x^0,x^1,x^2,x^3),\\
x'^2=f^{(2)}(x^0,x^1,x^2,x^3),\\
x'^3=f^{(3)}(x^0,x^1,x^2,x^3),
\end{cases}
\end{equation}
and let $\mathcal{S}_0$ be a subgroup of transformations of the form
\er{noninchgravortbstrjgghguittu2intrrrZZygjyghhjint}. Then, it is
clear, that given any object that is a scalar, four-vector,
four-covector, two-times covariant tensor or two-times contravariant
tensor on the group $\mathcal{S}_0$, defined in every cartesian
non-inertial coordinate system, we can uniquely extend the
definition of this object, in such a way that it will be defined
also in every curvilinear coordinate systems in $\mathbb{R}^4$ and
will be respectively a scalar, four-vector, four-covector, two-times
covariant tensor or two-times contravariant tensor on the wider
group $\mathcal{S}$. Thus all the physical laws that have a
covariant form preserve their form also in transformations of the
form \er{fgjfjhgghyuyyuyughgint} i.e. in curvilinear coordinate
systems. In particular, the Maxwell Equations in every curvilinear
coordinate system have the form of
\er{khjhhkfgjfjhgghhgjghjhjkkkkgjghghuiiiulkkjlkklKKgfgjhjjghgjhhjhjhhhhhghhgtyytojjjjjuyiyuughgfgint}
or equivalently of
\er{khjhhkfgjfjhgghhgjghjhjkkkkgjghghuiiiulkkjlkklKKgfgjhjjghgjhhjhjhhhhhghhgtyytojjjjjuyiyuughgfghhjint}:
\begin{multline}\label{khjhhkfgjfjhgghhgjghjhjkkkkgjghghuiiiulkkjlkklKKgfgjhjjghgjhhjhjhhhhhghhgtyytojjjjjuyiyuughgfghjjhkpkint}
\sum_{j=0}^{3}\frac{\partial}{\partial
x^j}\left(\sum_{m=0}^{3}\sum_{n=0}^{3}g^{km}g^{jn}\left(\frac{\partial
A_n}{\partial x^m}-\frac{\partial A_m}{\partial
x^n}\right)\right)+\\
\sum_{j=0}^{3}\frac{1}{\sqrt{|\text{det}\,G|}}\frac{\partial}{\partial
x^j}\left(\sqrt{|\text{det}\,G|}\right)\left(\sum_{m=0}^{3}\sum_{n=0}^{3}g^{km}g^{jn}\left(\frac{\partial
A_n}{\partial x^m}-\frac{\partial A_m}{\partial x^n}\right)\right)
=-4\pi j^k\quad\quad\forall\, k=0,1,2,3,
\end{multline}
or equivalently:
\begin{equation}\label{khjhhkfgjfjhgghhgjghjhjkkkkgjghghuiiiulkkjlkklKKgfgjhjjghgjhhjhjhhhhhghhgtyytojjjjjuyiyuughgfghhjkkhjint}
\sum_{j=0}^{3}\frac{\partial}{\partial
x^j}\left(\sum_{m=0}^{3}\sum_{n=0}^{3}\sqrt{|\text{det}\,G|}\,g^{km}g^{jn}\left(\frac{\partial
A_n}{\partial x^m}-\frac{\partial A_m}{\partial x^n}\right)\right)
=-4\pi \sqrt{|\text{det}\,G|}\, j^k\quad\quad\forall\, k=0,1,2,3.
\end{equation}
Here $\{A_k\}_{k=0,1,2,3}$ is the four-covector of the
electromagnetic potential, $\{j^k\}_{k=0,1,2,3}$ is the four-vector
of the current and $G:=\{g_{kj}\}_{k,j=0,1,2,3}$,
$\{g^{kj}\}_{k,j=0,1,2,3}$ are pseudo-metric covariant and
contravariant tensors. Note, that in curvilinear coordinate system
we can have $\text{det}\,G\neq Const$ and thus we need to consider
the enlarged form
\er{khjhhkfgjfjhgghhgjghjhjkkkkgjghghuiiiulkkjlkklKKgfgjhjjghgjhhjhjhhhhhghhgtyytojjjjjuyiyuughgfgint}
instead of
\er{khjhhkfgjfjhgghhgjghjhjkkkkgjghghuiiiulkkjlkklKKgfgjhjjghgjhhjhjhhhhhghhgtyytojjjjjuyiyuuint}.
Moreover, the density of the Lagrangian of the electromagnetic field
in every curvilinear coordinate system in $\mathbb{R}^4$ also has a
form of \er{vhfffngghkjgghPPNggjgjjkgjoiuioigghint}:
\begin{multline}\label{vhfffngghkjgghPPNggjgjjkgjoiuioigghjhhhint}
L_1=\frac{1}{4\pi}\left(-\sum_{n=0}^{3}\sum_{k=0}^{3}\frac{1}{4}F^{nk}F_{nk}-\sum_{k=0}^{3}4\pi
j^k A_k\right)=\\
\frac{1}{4\pi}\left(-\sum_{n=0}^{3}\sum_{k=0}^{3}\sum_{m=0}^{3}\sum_{p=0}^{3}\frac{1}{4}g^{mn}g^{pk}\left(\frac{\partial
A_p}{\partial x^m}-\frac{\partial A_m}{\partial
x^p}\right)\left(\frac{\partial A_k}{\partial x^n}-\frac{\partial
A_n}{\partial x^k}\right)-\sum_{k=0}^{3}4\pi j^k A_k\right),
\end{multline}
where
\begin{equation}\label{huohuioy89gjjhjffffff3478zzrrZZZhjhhjhhjjhhffGGhjjhjhhjhjint}
F_{ij}:=\frac{\partial A_j}{\partial x^i}-\frac{\partial
A_i}{\partial x^j}\quad\quad\forall\, i,j=0,1,2,3\,.
\end{equation}

Next the general Lagrangian of motion of the charged particle in the
gravitational and electromagnetic field
\er{btfffygtgyggyijhhkkSYSPNuhuygyygyggyuyyuyuykuhghgjjojint}
preserve its form in every curvilinear coordinate system:
\begin{equation}\label{btfffygtgyggyijhhkkSYSPNuhuygyygyggyuyyuyuykuhghgjjojhjfgint}
J_{\mathcal{G}}(\chi)=
\int_0^T\left\{-mc^2\;\mathcal{G}\left(\sum_{j=0}^{3}\sum_{k=0}^{3}g_{jk}\left(\chi(t)\right)\,\frac{d\chi^j}{dt}\,\frac{d\chi^k}{dt}\right)-\sum_{k=0}^{3}\sigma
A_k\left(\chi(t)\right)\,\frac{d\chi^k}{dt}\right\}dt.
\end{equation}
where $t$ is the global time, which is a scalar on the group
$\mathcal{S}$,
\begin{equation}\label{btfffjhgjghghijhhint}
\left(\chi^0(t),\chi^1(t),\chi^2(t),\chi^3(t)\right):=\left(\frac{1}{c}x^0(t),\frac{1}{c}x_1(t),\frac{1}{c}x_2(t),\frac{1}{c}x_3(t)\right),
\end{equation}
and $\left(x^0(t),x^1(t),x^2(t),x^3(t)\right)\in\mathbb{R}^4$ is a
four-dimensional space-time trajectory of the particle,
parameterized by the global time.

Note that if we denote by $t$ the scalar of global time, then in a
general curvilinear coordinate system the coordinate $x^0$ can
differ from $ct$, and the equality $x^0=ct$ valid, in general,
\underline{only} in \underline{cartesian} inertial or non-inertial
coordinate systems. However, since the equality in
\er{fgjfjhgghhgjghjhjijhojihjhjjijhjjjjjuiijjjkhjhjhjuiiuuuyuint}
has a covariant form, the scalar of the global time $t$ satisfies
the following Eikonal-type equation in every curvilinear coordinate
system:
\begin{equation}\label{fgjfjhgghhgjghjhjijhojihjhjjijhjjjjjuiijjjkhjhjhjuiiuuuyuuoiuuiioint}
\sum_{j=0}^{3}\sum_{k=0}^{3}\,g^{jk}
\,\frac{\partial
t}{\partial x^j}
\,\frac{\partial t}{\partial
x^k}
\,=\, \frac{1}{c^2}.
\end{equation}
Moreover, since the equality in
\er{fgjfjhgghhgjghjhjijhojihjhjjijhjjjjjuiijjjkhjhjhjuiiuuuyuint}
also has a covariant form, the following identity is valid in every
curvilinear coordinate system:
\begin{equation}\label{fgjfjhfgfgfgfhihhjhhjjhhjhjint}
\sum_{k=0}^{3}{\Theta}^{mk}\frac{\partial t}{\partial
x^k}=0\quad\quad\forall\, m=0,1,2,3,
\end{equation}
where $\Theta^{ij}$ is the contravariant tensor of the
three-dimensional geometry, that has the form
\er{huohuioy89gjjhjffffff3478zzrrZZZhjhhjhhjjhhffGGhjjhuiiuihhjkkoioiint}
\underline{only} in \underline{cartesian} inertial or non-inertial
coordinate systems.

Next, in the particular case of the relativistic-like Lagrangian
where $\mathcal{G}(\tau):=\sqrt{\tau}$, the Lagrangian in
\er{btfffygtgyggyijhhkkSYSPNuhuygyygyggyuyyuyuykuhghgjjojiyyjljlghint}
also preserve their form in every curvilinear coordinate system:
\begin{equation}\label{btfffygtgyggyijhhkkSYSPNuhuygyygyggyuyyuyuykuhghgjjojiyyjljlghhhhjhjint}
J_{rl}(\chi)=
\int_a^b\left\{-mc^2\;\sqrt{\left(\sum_{j=0}^{3}\sum_{k=0}^{3}g_{jk}\left(\chi(s)\right)\,\frac{d\chi^j}{ds}\,\frac{d\chi^k}{ds}\right)}-\sum_{k=0}^{3}\sigma
A_k\left(\chi(s)\right)\,\frac{d\chi^k}{ds}\right\}ds,
\end{equation}
where $s$ is the arbitrary parameter of the trajectory with
increasing mapping $t\leftrightarrow s$:
\begin{equation}\label{btfffjhgjghghijhhhuint}
\left(\chi^0(s),\chi^1(s),\chi^2(s),\chi^3(s)\right):=\left(\frac{1}{c}x^0(s),\frac{1}{c}x_1(s),\frac{1}{c}x_2(s),\frac{1}{c}x_3(s)\right).
\end{equation}
In particular, in the case $\frac{\partial \chi^0}{\partial t}>0$ we
can take $s:=\chi^0$ in
\er{btfffygtgyggyijhhkkSYSPNuhuygyygyggyuyyuyuykuhghgjjojiyyjljlghhhhjhjint}.

Finally, we would like to note the following fact: since in the
absence of essential gravitational masses we have $g_{ij}=J_{ij}$,
there exists a unique, up to equivalence, "preferable" inertial
coordinate system where $\vec v=\vec k=0$ everywhere. In this
particular system by
\er{hoyuiouigyfghgjh3478zzrrZZffGGjkkj88jokhjh88jkhkjhjh88int} we
have:
\begin{equation}\label{hoyuiouigyfg3478zzrrZZffGGhhjhjiuiint}
\begin{cases}
g_{00}
=1\\
g_{ij}
=-\delta_{ij}\quad\forall 1\leq i,j\leq 3\\
g_{0j}=g_{j0}
=0\quad\forall 1\leq j\leq 3.
\end{cases}
\end{equation}
and thus the Maxwell equations are the same as in the Special
Relativity. Moreover, in this system the Lagrangian of the motion of
the particle of the form
\er{btfffygtgyggyijhhkkSYSPNuhuygyygyggyuyyuyuykuhghgjjojiyyjljlghhhhjhjint}
is also the same as in the Special Relativity. Thus, since Maxwell
equations
\er{khjhhkfgjfjhgghhgjghjhjkkkkgjghghuiiiulkkjlkklKKgfgjhjjghgjhhjhjhhhhhghhgtyytojjjjjuyiyuughgfghjjhkpkint}
and the Lagrangian of the motion of particles
\er{btfffygtgyggyijhhkkSYSPNuhuygyygyggyuyyuyuykuhghgjjojiyyjljlghhhhjhjint}
preserve their form in every cartesian, non-cartesian or curvilinear
coordinate system of the group $\mathcal{S}$, they stay the same as
in Special Relativity also in the case of every cartesian,
non-cartesian or curvilinear coordinate system. Thus in the
particular case of $\mathcal{G}(\tau):=\sqrt{\tau}$ in
\er{btfffygtgyggyijhhkkSYSPNuhuygyygyggyuyyuyuykuhghgjjojhjfgint}
and in the absence of essential gravitational masses, the unique
formal mathematical difference between our model and the Special
Relativity is that in the frames of our model we consider the
Galilean Transformations as transformations of the change of
inertial cartesian coordinate systems, up to equivalence, and
\er{noninchgravortbstrjgghguittu2int} as transformations of the
change of non-inertial cartesian coordinate systems, however the
Lorenz transformations lead to inertial \underline{non-cartesian}
coordinate systems (see the following Definition
\ref{gjgjkfgkjfjkfjfjdfint}). In contrast, in the Special Relativity
the fundamental role of the Lorenz transformations, i.e. the
transformations that preserve the form
\er{hoyuiouigyfg3478zzrrZZffGGhhjhjiuiint} of the pseudo-metric
tensor, is postulated as the role of transformations of the change
of inertial cartesian coordinate systems, and at the same time the
Galilean Transformations lead to inertial \underline{non-cartesian}
coordinate systems, and transformations
\er{noninchgravortbstrjgghguittu2int} lead to non-inertial
\underline{non-cartesian} coordinate systems.

\subsubsection{Some general covariant identities in non-cartesian or curvilinear coordinate systems}
Consider the contravariant kinematic pseudo-metric tensor of inertia
$\{J^{ij}\}_{0\leq i,j\leq 3}$ on the group $\mathcal{S}$, defined
by
\begin{equation}\label{hoyuiouigyfghgjh3478zzrrZZffGGjkkjojj88jkhkhjhint}
J^{ij}:=k^ik^j-{\Theta}^{ij}\quad\quad\forall\,i,j=0,1,2,3,
\end{equation}
where $\{{\Theta}^{ij}\}_{0\leq i,j\leq 3}$ is the contravariant
tensor of the three-dimensional geometry on the group $\mathcal{S}$
(the same as in \er{fgjfjhfgfgfgfhihhjhhjjhhjhjint}) defined in
cartesian coordinate systems by
\er{huohuioy89gjjhjffffff3478zzrrZZZhjhhjhhjjhhffGGhjjhuiiuihhjkkoioiint},
and $(k^0,k^1,k^2,k^3)$ is the four-vector of the potential of
inertia on the group $\mathcal{S}$, defined in cartesian coordinate
systems by
\er{fgjfjhgghhgjghjhjijhojihjhjjijhjjjjjuiijjjkjkkhhkkhint}. Then,
as before, the reverse to $\{J^{ij}\}_{0\leq i,j\leq 3}$ covariant
tensor on the group $\mathcal{S}$ is denoted as $\{J_{ij}\}_{0\leq
i,j\leq 3}$ and satisfies
\begin{equation}\label{fgjfjhgghhgjghjhjkkkkgjghghuiiiuujhjh88jgkgjggkint}
\sum_{m=0}^{3}J^{im}J_{mj}=\begin{cases}
1\quad\text{if}\quad i=j\\
0\quad\text{if}\quad i\neq j
\end{cases}\quad\quad\forall\, i,j=0,1,2,3
\end{equation}
(as in \er{fgjfjhgghhgjghjhjkkkkgjghghuiiiuujhjh88int}).
\begin{definition}\label{gjgjkfgkjfjkfjfjdfjhhint} We say
that a given general coordinate system ($*$) is
\underline{cartesian} if the contravariant tensor of the
three-dimensional geometry $\{{\Theta}^{ij}\}_{0\leq i,j\leq 3}$ has
the following simple form in the system ($*$) (the same as in
\er{huohuioy89gjjhjffffff3478zzrrZZZhjhhjhhjjhhffGGhjjhuiiuihhjkkoioiint}):
\begin{equation}\label{huohuioy89gjjhjffffff3478zzrrZZZhjhhjhhjjhhffGGhjjhuiiuihhjkkoioijkljljkjkjkint}
\begin{cases}
{\Theta}^{00}=0
\\
{\Theta}^{0j}={\Theta}^{j0}=0\quad\quad\forall\, j=1,2,3
\\
{\Theta}^{ij}:=\delta_{ij}\quad\quad\forall\, i,j=1,2,3,
\end{cases}
\end{equation}
and at the same time in the system ($*$) we have
\begin{equation}\label{fgjfjhgghhgjghjhjijhojihjhjjijhjjjjjuiijjjkhjhjhjuiiuuuyuuoiuuiiokljkjkjljljint}
c\frac{\partial t}{\partial x^0}(x^0,x^1,x^2,x^3)=1,
\end{equation}
where $t$ is the scalar of global time (the same as in
\er{fgjfjhfgfgfgfhihhjhhjjhhjhjint}). In particular, by
\er{fgjfjhfgfgfgfhihhjhhjjhhjhjint},
\er{huohuioy89gjjhjffffff3478zzrrZZZhjhhjhhjjhhffGGhjjhuiiuihhjkkoioijkljljkjkjkint}
and
\er{fgjfjhgghhgjghjhjijhojihjhjjijhjjjjjuiijjjkhjhjhjuiiuuuyuuoiuuiiokljkjkjljljint}
all together in the system ($*$) we have:
\begin{equation}\label{fgjfjhgghhgjghjhjijhojihjhjjijhjjjjjuiijjjkhjhjhjuiiuuuyuuoiuuiiokljkjkjljjkkhhlint}
c\left(\frac{\partial t}{\partial
x^0}(x^0,x^1,x^2,x^3),\frac{\partial t}{\partial
x^1}(x^0,x^1,x^2,x^3),\frac{\partial t}{\partial
x^2}(x^0,x^1,x^2,x^3),\frac{\partial t}{\partial
x^3}(x^0,x^1,x^2,x^3)\right)=(1,0,0,0).
\end{equation}
Then, it easily can be shown that given an cartesian coordinate
system ($*$) and a general coordinate system ($**$), the system
($**$) is cartesian if and only if, up to a constant shift of the
time, the change of coordinates from system ($*$) to system ($**$)
is given by \er{noninchgravortbstrjgghguittu2intrrrZZygjyghhjint} or
equivalently by \er{noninchgravortbstrjgghguittu2intrrrZZygjygint}:
\begin{equation}\label{noninchgravortbstrjgghguittu2intrrrZZygjyghkujhkhilyiljgint}
\begin{cases}
\vec x'=A\left(\frac{x_0}{c}\right)\cdot\vec x+\vec z\left(\frac{x_0}{c}\right)=A\left(t\right)\cdot\vec x+\vec z\left(t\right),\\
x'_0=x_0=t,
\end{cases}
\end{equation}
where $A(t)\in SO(3)$ is a rotation.
\end{definition}
\begin{definition}\label{gjgjkfgkjfjkfjfjdfint} Similarly to cartesian coordinate systems, we say
that a given non-cartesian or curvilinear coordinate system ($*$) is
\underline{inertial} if the \underline{four-vector} of the potential
of inertia $(k^0,k^1,k^2,k^3)$, defined in cartesian systems by
\er{fgjfjhgghhgjghjhjijhojihjhjjijhjjjjjuiijjjkjkkhhkkhint} is
constant in $\mathbb{R}^4$. Then, it easily can be shown that given
an inertial coordinate system ($*$) and a general coordinate system
($**$), the system ($**$) is inertial if and only if the change of
coordinates from system ($*$) to system ($**$) is given by a
\underline{linaer} transformation in $\mathbb{R}^4$. In particular,
a general coordinate system ($***$) is inertial if and only if the
tensors $\{J^{ij}\}_{0\leq i,j\leq 3}$ and $\{J_{ij}\}_{0\leq
i,j\leq 3}$ are \underline{constant} in the four-dimensional
space-time in the given system ($***$). Moreover, given an inertial
cartesian coordinate system ($*$) and a general coordinate system
($**$), the system ($**$) is inertial and at the same time is
cartesian if and only if, up to a constant shift of time and, up to
equivalence of coordinate systems, the change of coordinates from
system ($*$) to system ($**$) is given by the Galilean
transformation.
\end{definition}
In the general non-cartesian or curvilinear coordinate system we
obviously have the following list of general covariant identities:
\begin{itemize}
\item
\begin{equation}\label{fgjfjhgghhgjghjhjkkkkgjghghuiiiuujhjhljhgugii88;jl;jkint}
\text{det}\,G=\text{det}\,J
\end{equation}
(see \er{fgjfjhgghhgjghjhjkkkkgjghghuiiiuujhjhljhgugii88int}), where
$4\times 4$-matrix $J$ is given by,
\begin{equation}\label{fgjfjhgghhgjghjhjkkkkgjghghuiiiuujhjhjkljojuuiyi88gkgkkkkhhkhkhint}
J=\{J_{ij}\}_{0\leq i,j\leq 3},
\end{equation}
and $4\times 4$-matrix $G$ is given by,
\begin{equation}\label{fgjfjhgghhgjghjhjkkkkgjghghuiiiuujhjhjkljojuuiyi88gkgkk;int}
G=\{g_{ij}\}_{0\leq i,j\leq 3},
\end{equation}
with the covariant pseudo-metric tensor of the four-dimensional
space-time $\{g_{ij}\}_{0\leq i,j\leq 3}$.
\item
\begin{equation}\label{hoyuiouigyfghgjh3478zzrrZZjhjhugyuyughggjhjhjyuuygtyffjjjkjljkjjljokjllllklljkljkjlnnnhkhkhkhkhh8899int}
\left\{\delta_j \left(\frac{\partial t}{\partial x_i}
\right)\right\}_J=0\quad\quad\forall\, i,j=0,1,2,3
\end{equation}
(see
\er{hoyuiouigyfghgjh3478zzrrZZjhjhugyuyughggjhjhjyuuygtyffjjjkjljkjjljokjllllklljkljkjlnnnhkhkhkhkhh88int}),
where
\begin{multline}\label{hoyuiouigyfghgjh3478zzrrZZjhjhugyuyughggjhjhjyuuygtyffjjjkjljkjjljokjllllklljkljkjlnnnljjl8899int}
\left\{\delta_j \left(\frac{\partial t}{\partial x_i}
\right)\right\}_J=\left\{\delta_i \left(\frac{\partial t}{\partial
x_j} \right)\right\}_J:=\frac{\partial^2 t}{\partial x^i\partial
x^j} -\sum_{m=0}^{3}\left\{\Gamma^{m}_{ij}\right\}_J\,\frac{\partial
t}{\partial x^m}\quad\quad\forall\, i,j=0,1,2,3\,,
\end{multline}
with
\begin{equation}\label{jhffhgffgh3478zzrrZZffjjkjkjklkkj8899int}
\begin{cases}
\left\{\Gamma_{i,mn}\right\}_J:=\frac{1}{2}\left(\frac{\partial
J_{im}}{\partial x_{n}}+\frac{\partial J_{in}}{\partial
x_{m}}-\frac{\partial
J_{mn}}{\partial x_{i}}\right)\\
\left\{\Gamma^{i}_{mn}\right\}_J:=\sum_{j=0}^{3}J^{ij}\left\{\Gamma_{j,mn}\right\}_J
\end{cases}\quad\quad\forall\, i,m,n=0,1,2,3.
\end{equation}
\item
\begin{equation}\label{fgjfjhgghhgjghjhjijhojihjhjjijhjjjjjuiijjjkhjhjhjuiiuuuyu88jukukukuk88kpjljl99int}
\sum_{m=0}^{3} \,J^{jm}\,\left(c\,\frac{\partial t}{\partial
x^m}\right)=k^j\quad\quad\forall\,j=0,1,2,3,
\end{equation}
and
\begin{equation}\label{fgjfjhgghhgjghjhjijhojihjhjjijhjjjjjuiijjjkhjhjhjuiiuuuyu88jukukukuk88kpjljlkk99int}
\sum_{m=0}^{3} \,g^{jm}\,\left(c\,\frac{\partial t}{\partial
x^m}\right)=v^j\quad\quad\forall\,j=0,1,2,3.
\end{equation}
\item
\begin{equation}\label{fgjfjhgghhgjghjhjijhojihjhjjijhjjjjjuiijjjkhjhjhjuiiuuuyu88jukukukuk88kpjljlkjojjhgjklint}
\sum_{j=0}^{3} \,k^j\,\left(c\,\frac{\partial t}{\partial
x^j}\right)=1=\sum_{j=0}^{3} \,v^j\,\left(c\,\frac{\partial
t}{\partial x^j}\right),
\end{equation}
\item
\begin{equation}\label{fgjfjhgghhgjghjhjijhojihjhjjijhjjjjjuiijjjkhjhjhjuiiuuuyu88mnmbbbjjjint}
c^2\left(\sum_{j=0}^{3}\sum_{m=0}^{3}\,J^{jm}\frac{\partial
t}{\partial x^j}\frac{\partial t}{\partial
x^m}\right)=1=c^2\left(\sum_{j=0}^{3}\sum_{m=0}^{3}\,g^{jm}\frac{\partial
t}{\partial x^j}\frac{\partial t}{\partial x^m}\right).
\end{equation}
\item
\begin{equation}\label{fgjfjhgghhgjghjhjijhojihjhjjijhjjjjjuiijjjkhjhjhjuiiuuuyu88jukukukuk88kpjljlmhmbbvnvvvkljljl88jkhkhkint}
\sum_{m=0}^{3} \,J_{jm}\,k^m\,=\,c\,\frac{\partial t}{\partial
x^j}\,=\,\sum_{m=0}^{3} \,g_{jm}\,v^m\quad\quad\forall\,j=0,1,2,3.
\end{equation}
\item
\begin{equation}\label{fgjfjhfgfgfgfhihhjhhjjhhjhjljjljkljjkint}
\sum_{m=0}^{3}{\Theta}^{jm}\frac{\partial t}{\partial
x^m}=0\quad\quad\forall\, j=0,1,2,3.
\end{equation}
\item
\begin{equation}\label{hoyuiouigyfghgjh3478zzrrZZffGGjkkjojj88jkhkhjhjljljklklint}
J^{ij}=k^ik^j-{\Theta}^{ij}\quad\quad\forall\,i,j=0,1,2,3,
\end{equation}
and
\begin{equation}\label{hoyuiouigyfghgjh3478zzrrZZffGGjkkjojj88jkhkhjhjljljjjkhhkhjk;k;int}
g^{ij}=v^iv^j-{\Theta}^{ij}\quad\quad\forall\,i,j=0,1,2,3,
\end{equation}
that implies, in particular,
\begin{equation}\label{hoyuiouigyfghgjh3478zzrrZZffGGjkkjojj88jkhkhjhjljljjjkhhkhjk;k;jkljhkint}
g^{ij}=\left(v^iv^j-k^ik^j\right)+J^{ij}\quad\quad\forall\,i,j=0,1,2,3.
\end{equation}
\item
\begin{equation}\label{fgjfjhgghhgjghjhjkkkkgjghghuiiiuujhjh88jgkgjggkjjjkint}
\sum_{m=0}^{3}J^{im}J_{mj}=\sum_{m=0}^{3}g^{im}g_{mj}=\begin{cases}
1\quad\text{if}\quad i=j\\
0\quad\text{if}\quad i\neq j
\end{cases}\quad\quad\forall\, i,j=0,1,2,3,
\end{equation}
that implies together with
\er{fgjfjhgghhgjghjhjijhojihjhjjijhjjjjjuiijjjkhjhjhjuiiuuuyu88jukukukuk88kpjljlmhmbbvnvvvkljljl88jkhkhkint}
and
\er{fgjfjhgghhgjghjhjijhojihjhjjijhjjjjjuiijjjkhjhjhjuiiuuuyu88mnmbbbjjjint}
the following:
\begin{equation}\label{fgjfjhgghhgjghjhjijhojihjhjjijhjjjjjuiijjjkhjhjhjuiiuuuyu88mnmbbbjjjoouuoojouint}
\sum_{j=0}^{3}\sum_{m=0}^{3}\,J_{jm}
k^jk^m=1=\sum_{j=0}^{3}\sum_{m=0}^{3}\,g_{jm}v^jv^m.
\end{equation}
\end{itemize}
Moreover, in the general non-cartesian or curvilinear coordinate
system, as before, we can define the Dynamical four-covector of
genuine gravity $(s_0,s_1,s_2,s_3)$ by:
\begin{equation}\label{huohuioy89gjjhjffffff3478zzrrZZZhjhhjhhjjhhffGGhjjhuiiuihhjkkoioirrkhrrjggggjrrjiokuukuppkpklhjjhj88khhjhint}
s_j=\frac{1}{2}\left(\sum_{m=0}^{3} g_{jm}k^m-\sum_{m=0}^{3}
J_{jm}v^m\right)\quad \forall j=0,1,2,3
\end{equation}
(see
\er{huohuioy89gjjhjffffff3478zzrrZZZhjhhjhhjjhhffGGhjjhuiiuihhjkkoioirrkhrrjggggjrrjiokuukuppkpklhjjhj88int}).
Then we have the following covariant identities
\begin{equation}\label{hoyuiouigyfghgjh3478zzrrZZffGGjkkjojj88iihhkhk88jklhkhjkhk8899int}
{g}_{ij}=J_{ij}+\left(s_i\left(c\,\frac{\partial t}{\partial
x^j}\right)+\left(c\,\frac{\partial t}{\partial
x^i}\right)s_j\right)\quad\quad\forall\,i,j=0,1,2,3
\end{equation}
(see
\er{hoyuiouigyfghgjh3478zzrrZZffGGjkkjojj88iihhkhk88jklhkhjkhk88int}),
and
\begin{equation}\label{hoyuiouigyfghgjh3478zzrrZZffGGjkkjojj88jklkhhk88khjhgh8899int}
v^j-k^j=\sum_{m=0}^{3}{\Theta}^{jm}s_m=\sum_{m=0}^{3}\left(k^jk^m-J^{jm}\right)s_m\quad\quad\forall\,j=0,1,2,3
\end{equation}
(see
\er{hoyuiouigyfghgjh3478zzrrZZffGGjkkjojj88jklkhhk88khjhgh88int}).

\begin{definition}\label{gjkfgjhfgdhdgdhint}
We say that a given general coordinate system ($*$) is
\underline{Lorentzian} if in system ($*$) kinematic pseudo-metric
tensors of inertia $\{J^{ij}\}_{0\leq i,j\leq 3}$ and
$\{J_{ij}\}_{0\leq i,j\leq 3}$ have the following simple form (the
same as in \er{hoyuiouigyfghgjh3478zzrrZZffGGjkkj88jokhjh88int} and
\er{hoyuiouigyfghgjh3478zzrrZZffGGjkkj88jokhjh88jkhkjhjh88int}):
\begin{equation}\label{hoyuiouigyfghgjh3478zzrrZZffGGjkkj88jokhjh8899int}
\begin{cases}
J^{00}=1\\
J^{ij}=-\delta_{ij}\quad\forall 1\leq i,j\leq 3\\
J^{0j}=J^{j0}=0\quad\forall 1\leq j\leq 3,
\end{cases}
\end{equation}
and
\begin{equation}\label{hoyuiouigyfghgjh3478zzrrZZffGGjkkj88jokhjh88jkhkjhjh8899int}
\begin{cases}
J_{00}=1\\
J_{ij}=-\delta_{ij}\quad\forall 1\leq i,j\leq 3\\
J_{0j}=J_{j0}=0\quad\forall 1\leq j\leq 3.
\end{cases}
\end{equation}
Thus, since in every Lorentzian coordinate system $\{J^{ij}\}_{0\leq
i,j\leq 3}$ and $\{J_{ij}\}_{0\leq i,j\leq 3}$ are constant in
$\mathbb{R}^4$, as it was explained in the end of Definition
\ref{gjgjkfgkjfjkfjfjdfint} every Lorentzian coordinate system is
necessary \underline{inertial}, however it is not necessary
cartesian. In particular, a unique, up to equivalence, "preferable"
inertial cartesian coordinate system ($\{0\}$) where the
three-dimensional vector potential of inertia satisfies $\vec k=0$,
is a unique, up to equivalence, coordinate system which is cartesian
and Lorentzian \underline{simultaneously}. On the other hand, there
exists a plenty of Lorentzian coordinate systems. In particular, it
is well known that, given a Lorentzian coordinate system ($*$) and a
general coordinate system ($**$), the system ($**$) is also
Lorentzian, if and only if the change of coordinates from system
($*$) to system ($**$) is given by the usual \underline{Lorentz}
transformation, up to certain trivial transformations. In every
Lorentzian coordinate system the four-vector of the potential of
inertia $(k^1,k^2,k^3,k^4)$ is \underline{constant} which is by
\er{fgjfjhgghhgjghjhjijhojihjhjjijhjjjjjuiijjjkhjhjhjuiiuuuyu88mnmbbbjjjoouuoojouint}
and \er{hoyuiouigyfghgjh3478zzrrZZffGGjkkj88jokhjh88jkhkjhjh8899int}
always satisfies
\begin{equation}\label{hoyuiouigyfghgjh3478zzrrZZffGGjkkjojj88iihhkhk88jklhkhjkhk8899jhjhjhint}
\left(k^0\right)^2-\left(\left(k^1\right)^2+\left(k^2\right)^2+\left(k^3\right)^2\right)=1.
\end{equation}
In fact in a general Lorentzian coordinate system the four-vector
$(k^1,k^2,k^3,k^4)$ can be arbitrary constant four-vector,
satisfying
\er{hoyuiouigyfghgjh3478zzrrZZffGGjkkjojj88iihhkhk88jklhkhjkhk8899jhjhjhint}.
We remind that in the unique cartesian Lorentzian coordinate system
we have $(k^1,k^2,k^3,k^4)=(1,0,0,0)$. Next by
\er{fgjfjhgghhgjghjhjijhojihjhjjijhjjjjjuiijjjkhjhjhjuiiuuuyu88jukukukuk88kpjljlmhmbbvnvvvkljljl88jkhkhkint}
together with \er{hoyuiouigyfghgjh3478zzrrZZffGGjkkj88jokhjh8899int}
we have:
\begin{equation}\label{fgjfjhgghhgjghjhjijhojihjhjjijhjjjjjuiijjjkhjhjhjuiiuuuyu88jukukukuk88kpjljlmhmbbvnvvvkljljl88jkhkhkjkjkjkjkklhhhjhkhhhkint}
\left(\frac{\partial t}{\partial x^0},\frac{\partial t}{\partial
x^1},\frac{\partial t}{\partial x^2},\frac{\partial t}{\partial
x^3}\right) \,=\, \frac{1}{c}\left(k^0,-k^1,-k^2,-k^3\right),
\end{equation}
and so, up to a constant shift in time, in every Lorentzian
coordinate system we have
\begin{equation}\label{hoyuiouigyfghgjh3478zzrrZZffGGjkkjojj88iihhkhk88jklhkhjkhk8899jhjhjhkjhkjjjkint}
t\,=\,\frac{1}{c}\,k^0x^0-\frac{1}{c}\left(k^1x^1+k^2x^2+k^3x^3\right).
\end{equation}
Note that in the unique cartesian Lorentzian coordinate system we
have $t=\frac{x^0}{c}$.
\end{definition}

\subsubsection{Certain curvilinear coordinate system in the case of
stationary radially symmetric gravitational field and relation to
the Schwarzschild metric} Assume that for a given part of the space
in some inertial or non-inertial cartesian coordinate system $(*)$
the gravitational field is stationary and radially symmetric that
means that the vectorial gravitational potential $\vec
v=(v^1,v^2,v^3)$ is independent of time variable $t$ and having the
form
\begin{equation}\label{ggyfgdfddsint}
\vec v(\vec x)=g\left(|\vec x|\right)\,\frac{\vec x}{|\vec
x|}\quad\quad\forall\,\vec x,
\end{equation}
for some scalar function $g(s):\mathbb{R}\to\mathbb{R}$. Next let
 $\Theta(\vec
x):\mathbb{R}^3\to\mathbb{R}$ be defined as:
\begin{equation}\label{ghfgfgfdfddfddfint}
\Theta(\vec x)=\xi\left(|\vec x|\right)\quad\forall\,\vec
x,\quad\quad\text{where}\quad\quad
\frac{d\xi}{ds}(s)=\frac{g(s)}{1-\frac{g^2(s)}{c^2}}\quad\quad\forall\,s,
\end{equation}
Then, consider the change of variables in the four-dimensional
space-time $\mathbb{R}^4$:
\begin{equation}\label{giuuihjghgghjgj78zzrrZZffhhhggygghghhjjknnKKMMghghint}
\begin{cases}
x'^0=x^0+\frac{\Theta\left(\left(x^1,x^2,x^3\right)\right)}{c}\\
x'^j=x^j\quad\quad\forall j=1,2,3.
\end{cases}
\end{equation}
that transforms the cartesian coordinate system $(*)$ to the
curvilinear coordinate system $(**)$ in the four-dimensional
space-time $\mathbb{R}^4$. Then in the terms of the
three-dimensional space and one-dimensional time:
\begin{equation}\label{fgjfjhgghjhjhjjhhjhhint}
(x^0,x^1,x^2,x^3):=(ct,x_1,x_2,x_3)=\left(ct,\vec x\right),
\end{equation}
we rewrite
\er{giuuihjghgghjgj78zzrrZZffhhhggygghghhjjknnKKMMghghint} as:
\begin{equation}\label{giuuihjghgghjgj78zzrrZZffhhhggygghghhjjknnKKMMint}
\begin{cases}
t'=t+\frac{\Theta(\vec x)}{c^2}\\
\vec x'=\vec x.
\end{cases}
\end{equation}
Note again, that since the new coordinate system $(**)$ in
$\mathbb{R}^4$ is \underline{curvilinear}, the time-like coordinate
$t'$ in coordinate system $(**)$ \underline{differ} from the proper
scalar of the global time.
%
%
%
%
%
%
Next consider the contravariant pseudo-metric tensor of the
four-dimensional space-time $\{g^{ij}\}_{0\leq i,j\leq 3}$ that due
to \er{hoyuiouigyfghgjh3478zzrrZZffGGjkkjint} has the form of
\begin{equation}\label{hoyuiouigyfg3478zzrrZZffhhhggughjttnnKKMMint}
\begin{cases}
g^{00}=1\\
g^{ij}=-\delta_{ij}+\frac{v^iv^j}{c^2}\quad\forall 1\leq i,j\leq 3\\
g^{0j}=g^{j0}=\frac{v^j}{c}\quad\forall 1\leq j\leq 3,
\end{cases}
\end{equation}
in the cartesian coordinate system $(*)$. We would like to find the
form $\{g'^{ij}\}_{0\leq i,j\leq 3}$ of this tensor in the
curvilinear coordinate system $(**)$. Then by
\er{fgjfjhgghhgjghjhjkkkkggghint} we have:
\begin{equation}\label{giuuihjghg78zzrrZZffhhhgguyuyyolhjhjhjjjkttnnKKMMint}
g'^{mn}=\sum_{i=0}^{3}\sum_{j=0}^{3}
\frac{\partial x'^m}{\partial x^i}\,\frac{\partial x'^n}{\partial
x^j} \,g^{ij}
\quad\quad\forall
0\leq m,n\leq 3.
\end{equation}
Then straightforward calculations presented in subsection
\ref{ghghfhffgfggffgfg} give that $\{g'^{ij}\}_{0\leq i,j\leq 3}$
has the following form in the system $(**)$:
\begin{equation}\label{giuuihjghg78zzrrZZffhhhgguyuyyolhjhjhjjjkjjjjttkklhjhjhghgikkjnnKK11KKhjhhjhjKKMMint}
\begin{cases}
g'^{00}=
\left(1-\frac{|\vec v|^2}{c^2}\right)^{-1} ,
\\
g'^{0n}=g'^{n0} =0 \quad\quad\forall 1\leq n\leq 3,
\\
g'^{mn}=\frac{v^m}{c}\frac{v^n}{c}-\delta_{mn} \quad\quad\forall
1\leq m,n\leq 3.
\end{cases}
\end{equation}
Next we find that the \underline{covariant} pseudo-metric tensor
$\{g'_{ij}\}_{0\leq i,j\leq 3}$ in the curvilinear coordinate system
$(**)$ has the following form:
\begin{equation}\label{giuuihjghg78zzrrZZffhhhgguyuyyolhjhjhjjjkjjjjttkklhjhjhghgikkjnnKK11KKhjhhjhjKKkkkllkKKMMioiiint}
\begin{cases}
g'_{00}=
\left(1-\frac{|\vec v|^2}{c^2}\right),
\\
g'_{0n}=g'_{n0} =0 \quad\quad\forall 1\leq n\leq 3,
\\
g'_{mn}=-\left(\left(1-\frac{|\vec
v|^2}{c^2}\right)^{-1}\frac{v^m}{c}\frac{v^n}{c}+\delta_{mn}\right)
\quad\quad\forall 1\leq m,n\leq 3.
\end{cases}
\end{equation}
In particular, taking into account
\er{giuuihjghgghjgj78zzrrZZffhhhggygghghhjjknnKKMMint} and
\er{ggyfgdfddsint} we deduce that the quadratic form, induced by the
covariant form of the pseudo-metric tensor $\{g'_{ij}\}_{0\leq
i,j\leq 3}$ in the curvilinear coordinate system $(**)$, that
defined on the tangent vectors
$\left(dx'^0,dx'^1,dx'^2,dx'^3\right)\in\mathbb{R}^4$ where $d\vec
x':=(dx'^1,dx'^2,dx'^3)$ has the following form:
%
%
%
%
%
%
\begin{multline}\label{ghfgfgdfdfddffdhhjhjint}
\sum_{i=0}^{3}\sum_{j=0}^{3}g'_{ij}dx'^idx'^j=\\
\left(1-\frac{\left|\vec v(\vec
x')\right|^2}{c^2}\right)dx'^2_0-\left(\left(1-\frac{\left|\vec
v(\vec x')\right|^2}{c^2}\right)^{-1}\left|\frac{\vec x'}{|\vec
x'|}\cdot d\vec x'\right|^2+\left(|d\vec x'|^2-\left|\frac{\vec
x'}{|\vec x'|}\cdot d\vec x'\right|^2\right)\right).
\end{multline}

Next, up to the end of this subsection, assume that our cartesian
coordinate system $(*)$ is non-rotating and our gravitational field
is formed by the spherical symmetric massive body of mass $m_0$ and
radius $R_0$ like the Earth, the Sun et.al. with the center at the
point $0$. Then, as we get either in
\er{MaxVacFull1ninshtrgravortghhghgjkgghklhjgkghghjjkjhjkkggjkhjkhjjhhfhjhklkhkhjjklzzzyyyhjggjhgghhjhNWNWBWHWPPN222kkkgtytghjjhpoppkkkhhhkhjhjint}
and
\er{MaxVacFull1ninshtrgravortghhghgjkgghklhjgkghghjjkjhjkkggjkhjkhjjhhfhjhklkhkhjjklzzzyyyhjggjhgghhjhNWNWBWHWPPN222kkkgtytghjjhpoppkkkhhhkhjhjiuuint}
(see Remark \ref{gygygygyggjh}) in the case of the Newtonian
gravity, or in Remark \ref{ghjfhgfdhdnw22mm} in the case of the more
general gravity model, given by \er{guigjgjffghguygjyfDDnw22mmint}
with arbitrary constant $\beta$, we have: either
\begin{equation}
\label{MaxVacFull1ninshtrgravortghhghgjkgghklhjgkghghjjkjhjkkggjkhjkhjjhhfhjhklkhkhjjklzzzyyyhjggjhgghhjhNWNWBWHWPPN222kkkgtytghjjhpoppkkkhhhkhjhjhjhjjint}
\vec v(\vec x)= \frac{\sqrt{-2\Phi_1(|\vec x|)}}{|\vec x|}\vec x,
\end{equation}
or
\begin{equation}
\label{MaxVacFull1ninshtrgravortghhghgjkgghklhjgkghghjjkjhjkkggjkhjkhjjhhfhjhklkhkhjjklzzzyyyhjggjhgghhjhNWNWBWHWPPN222kkkgtytghjjhpoppkkkhhhkhjhjiuuiuhhjint}
\vec v(\vec x)= -\frac{\sqrt{-2\Phi_1(|\vec x|)}}{|\vec x|}\vec x,
\end{equation}
where $\Phi_1$ is the classical Newtonian potential of our massive
body $m_0$ that satisfies
\begin{equation}\label{ghgffgfgfgfint}
\Phi_1(\vec x)=-\frac{Gm_0}{|\vec x|}
\end{equation}
outside of the body surface.
%
%
%
%
%
%
Both
\er{MaxVacFull1ninshtrgravortghhghgjkgghklhjgkghghjjkjhjkkggjkhjkhjjhhfhjhklkhkhjjklzzzyyyhjggjhgghhjhNWNWBWHWPPN222kkkgtytghjjhpoppkkkhhhkhjhjhjhjjint}
and
\er{MaxVacFull1ninshtrgravortghhghgjkgghklhjgkghghjjkjhjkkggjkhjkhjjhhfhjhklkhkhjjklzzzyyyhjggjhgghhjhNWNWBWHWPPN222kkkgtytghjjhpoppkkkhhhkhjhjiuuiuhhjint}
are particular cases of \er{ggyfgdfddsint}, with
\begin{equation}
\label{MaxVacFull1ninshtrgravortghhghgjkgghklhjgkghghjjkjhjkkggjkhjkhjjhhfhjhklkhkhjjklzzzyyyhjggjhgghhjhNWNWBWHWPPN222kkkgtytghjjhpoppkkkhhhkhjhjiuuiuhhjjkkkmnint}
g(s)=\pm\sqrt{-2\Phi_1(s)},
\end{equation}
and in particular, outside of the massive body surface we have:
\begin{equation}
\label{MaxVacFull1ninshtrgravortghhghgjkgghklhjgkghghjjkjhjkkggjkhjkhjjhhfhjhklkhkhjjklzzzyyyhjggjhgghhjhNWNWBWHWPPN222kkkgtytghjjhpoppkkkhhhkhjhjiuuiuhhjjkkkmnhjjhint}
g\left(|x|\right)=\pm\sqrt{\frac{2Gm_0}{|\vec x|}},
\end{equation}
Thus defining the function $\Theta(\vec x)$ as in
\er{ghfgfgfdfddfddfint}, that always can be done in the case
$\frac{2Gm_0}{R_0}<c^2$, we can define the change of variables from
coordinate system $(*)$ to the curvilinear coordinate system $(**)$
in the four-dimensional space-time $\mathbb{R}^4$ as in
\er{giuuihjghgghjgj78zzrrZZffhhhggygghghhjjknnKKMMint}:
\begin{equation}\label{giuuihjghgghjgj78zzrrZZffhhhggygghghhjjknnKKMMihint}
\begin{cases}
t'=t+\frac{\Theta(\vec x)}{c^2}\\
\vec x'=\vec x.
\end{cases}
\end{equation}
Then by inserting
\er{MaxVacFull1ninshtrgravortghhghgjkgghklhjgkghghjjkjhjkkggjkhjkhjjhhfhjhklkhkhjjklzzzyyyhjggjhgghhjhNWNWBWHWPPN222kkkgtytghjjhpoppkkkhhhkhjhjhjhjjint}
or
\er{MaxVacFull1ninshtrgravortghhghgjkgghklhjgkghghjjkjhjkkggjkhjkhjjhhfhjhklkhkhjjklzzzyyyhjggjhgghhjhNWNWBWHWPPN222kkkgtytghjjhpoppkkkhhhkhjhjiuuiuhhjint}
into
\er{giuuihjghg78zzrrZZffhhhgguyuyyolhjhjhjjjkjjjjttkklhjhjhghgikkjnnKK11KKhjhhjhjKKkkkllkKKMMioiiint}
we deduce the form of the covariant pseudo-metric tensor in the
curvilinear coordinate system $(**)$:
\begin{equation}\label{giuuihjghg78zzrrZZffhhhgguyuyyolhjhjhjjjkjjjjttkklhjhjhghgikkjnnKK11KKhjhhjhjKKkkkllkKKMMioiijjjkjkint}
\begin{cases}
g'_{00}=
\left(1+\frac{2\Phi_1(|\vec x'|)}{c^2}\right),
\\
g'_{0n}=g'_{n0} =0 \quad\quad\forall 1\leq n\leq 3,
\\
g'_{mn}=\left(\left(1+\frac{2\Phi_1(|\vec
x'|)}{c^2}\right)^{-1}\frac{2\Phi_1(|\vec
x'|)}{c^2}\frac{x'_m}{|\vec x'|}\frac{x'_n}{|\vec
x'|}-\delta_{mn}\right) \quad\quad\forall 1\leq m,n\leq 3.
\end{cases}
\end{equation}
Moreover, by \er{ghfgfgdfdfddffdhhjhjint} we have:
\begin{multline}\label{ghfgfgdfdfddffdhhjhjuiuuuiint}
\sum_{i=0}^{3}\sum_{j=0}^{3}g'_{ij}dx'^idx'^j=\\
\left(1+\frac{2\Phi_1(|\vec
x'|)}{c^2}\right)dx'^2_0-\left(\left(1+\frac{2\Phi_1(|\vec
x'|)}{c^2}\right)^{-1}\left|\frac{\vec x'}{|\vec x'|}\cdot d\vec
x'\right|^2+\left(|d\vec x'|^2-\left|\frac{\vec x'}{|\vec x'|}\cdot
d\vec x'\right|^2\right)\right).
\end{multline}
In particular, outside of the massive body surface, i.e. when
$|x'|>R_0$ we rewrite
\er{giuuihjghg78zzrrZZffhhhgguyuyyolhjhjhjjjkjjjjttkklhjhjhghgikkjnnKK11KKhjhhjhjKKkkkllkKKMMioiijjjkjkint}
and \er{ghfgfgdfdfddffdhhjhjuiuuuiint} as:
\begin{equation}\label{giuuihjghg78zzrrZZffhhhgguyuyyolhjhjhjjjkjjjjttkklhjhjhghgikkjnnKK11KKhjhhjhjKKkkkllkKKMMioiijjjkjkuyuhint}
\begin{cases}
g'_{00}=
\left(1-\frac{2Gm_0}{c^2|\vec x'|}\right),
\\
g'_{0n}=g'_{n0} =0 \quad\quad\forall 1\leq n\leq 3,
\\
g'_{mn}=-\left(\left(1-\frac{2Gm_0}{c^2|\vec
x'|}\right)^{-1}\frac{2Gm_0}{c^2|\vec x'|}\frac{x'_m}{|\vec
x'|}\frac{x'_n}{|\vec x'|}+\delta_{mn}\right) \quad\quad\forall
1\leq m,n\leq 3,
\end{cases}
\end{equation}
and
\begin{multline}\label{ghfgfgdfdfddffdhhjhjuiuuuikkjint}
\sum_{i=0}^{3}\sum_{j=0}^{3}g'_{ij}dx'^idx'^j=\\
\left(1-\frac{2Gm_0}{c^2|\vec
x'|}\right)dx'^2_0-\left(\left(1-\frac{2Gm_0}{c^2|\vec
x'|}\right)^{-1}\left|\frac{\vec x'}{|\vec x'|}\cdot d\vec
x'\right|^2+\left(|d\vec x'|^2-\left|\frac{\vec x'}{|\vec x'|}\cdot
d\vec x'\right|^2\right)\right).
\end{multline}
Therefore, we get that in coordinate system $(**)$, outside of the
massive body, the covariant pseudo-metric tensor in
\er{giuuihjghg78zzrrZZffhhhgguyuyyolhjhjhjjjkjjjjttkklhjhjhghgikkjnnKK11KKhjhhjhjKKkkkllkKKMMioiijjjkjkuyuhint}
and \er{ghfgfgdfdfddffdhhjhjuiuuuikkjint} \underline{exactly} the
same as the well known Schwarzschild metric from the General
Relativity. Indeed in the spherical coordinates in $\mathbb{R}^3$ we
rewrite \er{ghfgfgdfdfddffdhhjhjuiuuuikkjint} as:
\begin{multline}\label{ghfgfgdfdfddffdhhjhjuiuuuikkjjhhjhjint}
\sum_{i=0}^{3}\sum_{j=0}^{3}g'_{ij}dx'^idx'^j=\\
\left(1-\frac{2Gm_0}{c^2r'}\right)dx'^2_0-\left(\left(1-\frac{2Gm_0}{c^2r'}\right)^{-1}\left(dr'\right)^2+(r')^2\left((d\theta')^2+\sin^2{\left(\theta'\right)}(d\varphi')^2\right)\right),
\end{multline}
and this is exactly the classical Schwarzschild metric.

In particular, if we consider the monochromatic electromagnetic wave
of frequency $\omega$ of the form $e^{i\omega t}U(\vec x)$ in the
coordinate system $(*)$, then by
\er{giuuihjghgghjgj78zzrrZZffhhhggygghghhjjknnKKMMihint} in the
coordinate system $(**)$ the form of this light is $e^{i\omega
t'}U'(\vec x')$ where $U'(\vec x')=U(\vec
x')e^{-i\omega\frac{\Theta(\vec x')}{c^2}}$, i.e the electromagnetic
wave in the coordinate system  $(**)$ is also monochromatic of the
same frequency $\omega$. Thus all the optical effects that we find
in the frames of our model coincides with the effects considered in
the frames of General Relativity for the Schwarzschild metric. In
particular, the Michelson-Morely experiment and all Sagnac-type
effects will lead to the same result in the frame of our model like
in the case of the General relativity. Moreover, since the Maxwell
equations in both models have the same tensor form, all the
electromagnetic effects, where the time does not appear explicitly
will be the same. Similarly, the curvature of the light path in the
Sun's gravitational field will be the same in both models. Finally,
in the particular case of $\mathcal{G}(\tau)=\sqrt{\tau}$ in
\er{btfffygtgyggyijhhkkSYSPNuhuygyygyggyuyyuyuykuhghgjjojhjfgint},
i.e. in the case of the relativistic-like Lagrangian of the motion
in
\er{btfffygtgyggyijhhkkSYSPNuhuygyygyggyuyyuyuykuhghgjjojiyyyuyuuyyint}
all the mechanical effects will be the same in the frame of our
model like in the case of the General relativity  for the
Schwarzschild metric, provided that the time does not appear
explicitly in this effects. In particular, the movement of the
Mercury planet in the Sun's gravitational field will be the same in
both models, provided we take into account the relativistic-like
Lagrangian of the motion as in
\er{btfffygtgyggyijhhkkSYSPNuhuygyygyggyuyyuyuykuhghgjjojiyyyuyuuyyint}.

Finally, note that the similar, solution as in
\er{ghfgfgdfdfddffdhhjhjuiuuuikkjint} or
\er{ghfgfgdfdfddffdhhjhjuiuuuikkjjhhjhjint} is valid also for the
general laws of the gravity, given by either
\er{MaxVacFull1bjkgjhjhgjaaahkjhhjzzzyyykkknnnNWBWHWPPNNewCCmmhhjjhkkkint}
or \er{guigjgjffghguygjyfDDnw22mm34int}, where in all places we take
$M_0$ instead of $m_0$ and $\Phi_1(\vec x)=-\frac{GM_0}{|\vec x|}$
instead of \er{ghgffgfgfgfint}, with $M_0$ being the total effective
\underline{gravitational} mass of the Earth (see subsection
\ref{ghjfhgfdhdnw22mm34} for the details).

\subsubsection{General Lagraingian of the
gravitational-electromagnetic field, compatible with the general
Lagrangian of the motion in
\er{btfffygtgyggyijhhkkSYSPNuhuygyygyggyuyyuyuykuhghgjjojint}: The
case of the Newtonian-type gravity}

Given known the distribution of inertial mass density of some
continuum medium $\mu:=\mu(\vec x,t)$, the field of velocities of
this medium $\vec u:=\vec u(\vec x,t)$, the charge density
$\rho:=\rho(\vec x,t)$ and the current density $\vec j:=\vec j(\vec
x,t)$ in a cartesian coordinate system, consider a general
Lagrangian density $L$ of the unified gravitational-electromagnetic
field that generalize the Lagrangian density defined by
\er{vhfffngghkjgghDDint} and is consistent with the Lagrangian of
the motion of particles of the general form
\er{btfffygtgyggyijhhkkSYSPNuhuygyygyggyuyyuyuykuhghgjjojint}:
\begin{multline}\label{vhfffngghkjgghDDmmjjint}
L\left(\vec A,\Psi,\vec v,\Phi,\vec p,\vec
x,t\right):=\frac{1}{8\pi}\left|-\nabla_{\vec
x}\Psi-\frac{1}{c}\frac{\partial\vec A}{\partial t}+\frac{1}{c}\vec
v\times curl_{\vec x}\vec A\right|^2-\frac{1}{8\pi}\left|curl_{\vec
x}\vec A\right|^2-\left(\rho\Psi-\frac{1}{c}\vec A\cdot\vec
j\right)\\-\mu c^2 \mathcal{G}\left(1-\frac{1}{c^2}\left|\vec u-\vec
v\right|^2\right)+
\frac{1}{2}\left(d_{\vec x}\vec v+\left\{d_{\vec x}\vec
v\right\}^T\right):\left(d_{\vec x}\vec p+\left\{d_{\vec x}\vec
p\right\}^T\right)-2\left(div_{\vec x}\vec v\right)\left(div_{\vec
x}\vec p\right)\\+\frac{1}{4\pi G}\left(div_{\vec x}\vec
v\right)\left(\frac{\partial \Phi}{\partial t}+\vec
v\cdot\nabla_{\vec x} \Phi\right)+\frac{1}{4\pi
G}\Phi\left(div_{\vec x}\vec v\right)^2-\frac{\Phi}{16\pi
G}\left|d_{\vec x}\vec v+\left\{d_{\vec x}\vec
v\right\}^T\right|^2+\frac{1}{8\pi G}\left|\nabla_{\vec
x}\Phi\right|^2,
\end{multline}
where $\mathcal{G}(s):\mathbb{R}\to\mathbb{R}$ is a given function,
$\Phi$ is some ancillary proper scalar field and $\vec p$ is some
ancillary proper vector field. Then $L$ is invariant under the
change of inertial or non-inertial cartesian coordinate system of
the form \er{noninchgravortbstrjgghguittu2intrrrZZygjyg}. Then
denoting the function
\begin{equation}\label{ujghjjghjhint}
g(s):=-c^2\mathcal{G}\left(1-\frac{2s}{c^2}\right)\quad\quad\forall
s
\end{equation}
we rewrite \er{vhfffngghkjgghDDmmjjint} as:
\begin{multline}\label{vhfffngghkjgghDDmmint}
L\left(\vec A,\Psi,\vec v,\Phi,\vec p,\vec
x,t\right):=\frac{1}{8\pi}\left|-\nabla_{\vec
x}\Psi-\frac{1}{c}\frac{\partial\vec A}{\partial t}+\frac{1}{c}\vec
v\times curl_{\vec x}\vec A\right|^2-\frac{1}{8\pi}\left|curl_{\vec
x}\vec A\right|^2-\left(\rho\Psi-\frac{1}{c}\vec A\cdot\vec
j\right)\\+\mu g\left(\frac{1}{2}\left|\vec u-\vec
v\right|^2\right)+
\frac{1}{2}\left(d_{\vec x}\vec v+\left\{d_{\vec x}\vec
v\right\}^T\right):\left(d_{\vec x}\vec p+\left\{d_{\vec x}\vec
p\right\}^T\right)-2\left(div_{\vec x}\vec v\right)\left(div_{\vec
x}\vec p\right)\\+\frac{1}{4\pi G}\left(div_{\vec x}\vec
v\right)\left(\frac{\partial \Phi}{\partial t}+\vec
v\cdot\nabla_{\vec x} \Phi\right)+\frac{1}{4\pi
G}\Phi\left(div_{\vec x}\vec v\right)^2-\frac{\Phi}{16\pi
G}\left|d_{\vec x}\vec v+\left\{d_{\vec x}\vec
v\right\}^T\right|^2+\frac{1}{8\pi G}\left|\nabla_{\vec
x}\Phi\right|^2.
\end{multline}
We point out two the most important choices of function
$\mathcal{G}(s)$: fully non-relativistic choice
$\mathcal{G}(s)=\frac{s}{2}$ and correspondingly
$g(s)=\left(s-\frac{c^2}{2}\right)$; and relativistic-like choice
$\mathcal{G}(s)=\sqrt{s}$ and correspondingly
$g(s):=-c^2\sqrt{1-\frac{2s}{c^2}}$. Note also that in the first
case we have $\frac{dg}{ds}(s)=1$ and in the second case
$\frac{dg}{ds}(s)=\left(1-\frac{2s}{c^2}\right)^{-\frac{1}{2}}\approx
1$, where the last equation is valid in the case where $2s\ll c^2$.

Then, as before, in subsection \ref{hughjughghffghfgffg} we obtain
that a configuration $(\vec A,\Psi,\vec v,\Phi,\vec p)$ is a
critical point of the functional
\begin{equation}\label{btfffygtgyggyDDmmint}
J=\int_0^T\int_{\mathbb{R}^3}L\left(\vec A,\Psi,\vec v,\Phi,\vec
p,\vec x,t\right)d\vec x dt.
\end{equation}
if and only if it satisfies
\begin{equation}\label{guigjgjffghguygjyfDDmmint}
\begin{cases}
curl_{\vec x}\vec H=\frac{4\pi}{c}\vec
j+\frac{1}{c}\frac{\partial\vec D}{\partial
t}\\
div_{\vec x}\vec D=4\pi\rho\\
curl_{\vec x}\vec E+\frac{1}{c}\frac{\partial\vec B}{\partial t}=0\\
div_{\vec x}\vec B=0\\
\vec E=\vec D-\frac{1}{c}\vec v\times\vec B\\
\vec H=\vec B+\frac{1}{c}\vec v\times\vec D\\
curl_{\vec x}\left(curl_{\vec x}\vec v\right)=0
\\
\frac{\partial}{\partial t}\left\{div_{\vec x}\vec v\right\}+\vec
v\cdot\nabla_{\vec x}\left(div_{\vec x}\vec
v\right)+\frac{1}{4}\left|d_{\vec x}\vec v+\left\{d_{\vec x}\vec
v\right\}^T\right|^2=-\Delta_{\vec x}\Phi\\
\left(\mu g'\left(\frac{1}{2}\left|\vec u-\vec
v\right|^2\right)(\vec u-\vec v)+\frac{1}{4\pi c}\vec D\times\vec
B\right)\\=curl_{\vec x}\left(curl_{\vec x}\vec
p\right)-\frac{1}{4\pi G}\left(\frac{\partial}{\partial
t}\left(\nabla_{\vec x}\Phi\right)-curl_{\vec x}\left(\vec
v\times\nabla_{\vec x}\Phi\right)+\left(\Delta_{\vec
x}\Phi\right)\vec v\right).
\end{cases}
\end{equation}
where, consistently with \er{guigjgjffghPPNint} we denote
\begin{equation}\label{guigjgjffghDDmmint}
\begin{cases}
\vec D:=-\nabla_{\vec x}\Psi-\frac{1}{c}\frac{\partial\vec
A}{\partial t}+\frac{1}{c}\vec
v\times curl_{\vec x}\vec A\\
\vec B:=curl_{\vec x}\vec A
\\
\vec E:=-\nabla_{\vec x}\Psi-\frac{1}{c}\frac{\partial\vec
A}{\partial t}\\
\vec H:=curl_{\vec x}\vec A+\frac{1}{c}\vec
v\times\left(-\nabla_{\vec x}\Psi-\frac{1}{c}\frac{\partial\vec
A}{\partial t}+\frac{1}{c}\vec v\times curl_{\vec x}\vec A\right).
\end{cases}
\end{equation}

Next consider the equations of the gravitational-electromagnetic
field in the form \er{guigjgjffghguygjyfDDmmint}. Then, as before,
defining the gravitational mass
\begin{equation}\label{Mjghjghhjgint}
M:=\frac{1}{4\pi G}\Delta_{\vec x}\Phi
\end{equation}
and using the continuum equation
\begin{equation}\label{MaxVacFull1ninshtrgravortghhghgjkgghklhjgkghghjjkjhjkkggjkhjkhjjhhfhjhkhjjkhjgzzzyyykkknnnNWBWHWPPNNewZZZCCmmhjhjint}
\frac{\partial\mu}{\partial t}+div_{\vec x}\left(\mu\vec u\right)=0.
\end{equation}
we rewrite \er{guigjgjffghguygjyfDDmmint} as:
\begin{equation}\label{MaxVacFull1bjkgjhjhgjaaahkjhhjzzzyyykkknnnNWBWHWPPNNewCCmmhhjjhkkkint}
\begin{cases}
curl_{\vec x}\vec H=\frac{4\pi}{c}\vec
j+\frac{1}{c}\frac{\partial\vec D}{\partial
t},\\
div_{\vec x}\vec D=4\pi\rho,\\
curl_{\vec x}\vec E+\frac{1}{c}\frac{\partial\vec B}{\partial t}=0,\\
div_{\vec x}\vec B=0,\\
\vec E=\vec D-\frac{1}{c}\vec v\times\vec B,\\
\vec H=\vec B+\frac{1}{c}\vec v\times\vec D,\\
\frac{\partial}{\partial t}\left\{div_{\vec x}\vec v\right\}+\vec
v\cdot\nabla_{\vec x}\left(div_{\vec x}\vec
v\right)+\frac{1}{4}\left|d_{\vec x}\vec v+\left\{d_{\vec x}\vec
v\right\}^T\right|^2=-4\pi GM,\\
 curl_{\vec
x}\left(curl_{\vec x}\vec v\right)=0,
\\
\frac{\partial}{\partial t}\left(M-\mu\right)+div_{\vec
x}\left\{\left(M-\mu\right)\vec v\right\}=-div_{\vec x}\left\{\mu
\left(g'\left(\frac{1}{2}\left|\vec u-\vec
v\right|^2\right)-1\right)(\vec u-\vec v)+\frac{1}{4\pi c}\vec
D\times\vec B\right\}\,.
\end{cases}
\end{equation}
Note again for the last equation in
\er{MaxVacFull1bjkgjhjhgjaaahkjhhjzzzyyykkknnnNWBWHWPPNNewCCmmhhjjhkkkint}
that: in the fully non-relativistic case we have $g'(s)=1$ and in
the relativistic-like case we have
$g'(s)=\left(1-\frac{2s}{c^2}\right)^{-\frac{1}{2}}\approx 1$, where
the last equation is valid in the case where $2s\ll c^2$.

\subsubsection{General Lagraingian of the
gravitational-electromagnetic field, compatible with the general
Lagrangian of the motion in
\er{btfffygtgyggyijhhkkSYSPNuhuygyygyggyuyyuyuykuhghgjjojint}: the
case of some possible alternative model of the
gravity}\label{ghjfhgfdhdnw22mm34int} Consider $\vec k$ to be the
vectorial potential of the inertia, which is a generally trivial
speed-like vector field, assumed to be fixed in every fixed inertial
or non-inertial cartesian coordinate system (see Definition
\ref{jjhgyftdtdint}). Given known the distribution of inertial mass
density of some continuum medium $\mu:=\mu(\vec x,t)$, the field of
velocities of this medium $\vec u:=\vec u(\vec x,t)$, the charge
density $\rho:=\rho(\vec x,t)$ and the current density $\vec j:=\vec
j(\vec x,t)$ consider a Lagrangian density $L$ defined by
\begin{multline}\label{vhfffngghkjgghDDnw22mm34mm34int}
L\left(\vec A,\Psi,\vec v,\Phi_0,\vec h,\vec x,t\right):=\\
\frac{1}{8\pi}\left|-\nabla_{\vec
x}\Psi-\frac{1}{c}\frac{\partial\vec A}{\partial t}+\frac{1}{c}\vec
v\times curl_{\vec x}\vec A\right|^2-\frac{1}{8\pi}\left|curl_{\vec
x}\vec A\right|^2-\left(\rho\Psi-\frac{1}{c}\vec A\cdot\vec j\right)
-\mu c^2 \mathcal{G}\left(1-\frac{1}{c^2}\left|\vec u-\vec
v\right|^2\right)
\\
-\frac{c^2}{8\pi G}\left|-\nabla_{\vec x}\Phi_0
-\frac{1}{c}\frac{\partial\vec h}{\partial t}
+\frac{1}{c}\vec v\times curl_{\vec x}\vec h-\frac{1}{c}\nabla_{\vec
x}\left(\vec h\cdot\vec v\right)\right|^2+\frac{c^2}{8\pi
G}\left|curl_{\vec x}\vec h\right|^2\\+\frac{c^2\beta}{4\pi
G}\left(\frac{1}{4}\left|d_{\vec x}\vec v+\left\{d_{\vec x}\vec
v\right\}^T\right|^2-\left|\Div_{\vec x}\vec v\right|^2\right) ,
\end{multline}
where
\begin{equation}\label{btfffygtgyggyDDnw22jkhk22mm34int}
\vec h=\vec v-\vec
k\quad\text{and}\quad\Phi_0=-\frac{1}{2c}\left|\vec v-\vec
k\right|^2,
\end{equation}
$\mathcal{G}(s):\mathbb{R}\to\mathbb{R}$ is a given function, and
$\beta\in \mathbb{R}$ is some constant. Then, as before, denoting
the function
\begin{equation}\label{ujghjjghjh34int}
g(s):=-c^2\mathcal{G}\left(1-\frac{2s}{c^2}\right)\quad\quad\forall
s
\end{equation}
we rewrite \er{vhfffngghkjgghDDnw22mm34mm34int} as:
\begin{multline}\label{vhfffngghkjgghDDnw22mm34int}
L\left(\vec A,\Psi,\vec v,\Phi_0,\vec h,\vec x,t\right):=\\
\frac{1}{8\pi}\left|-\nabla_{\vec
x}\Psi-\frac{1}{c}\frac{\partial\vec A}{\partial t}+\frac{1}{c}\vec
v\times curl_{\vec x}\vec A\right|^2-\frac{1}{8\pi}\left|curl_{\vec
x}\vec A\right|^2-\left(\rho\Psi-\frac{1}{c}\vec A\cdot\vec j\right)
+\mu g\left(\frac{1}{2}\left|\vec u-\vec v\right|^2\right)
\\
-\frac{c^2}{8\pi G}\left|-\nabla_{\vec x}\Phi_0
-\frac{1}{c}\frac{\partial\vec h}{\partial t}
+\frac{1}{c}\vec v\times curl_{\vec x}\vec h-\frac{1}{c}\nabla_{\vec
x}\left(\vec h\cdot\vec v\right)\right|^2+\frac{c^2}{8\pi
G}\left|curl_{\vec x}\vec h\right|^2\\+\frac{c^2\beta}{4\pi
G}\left(\frac{1}{4}\left|d_{\vec x}\vec v+\left\{d_{\vec x}\vec
v\right\}^T\right|^2-\left|\Div_{\vec x}\vec v\right|^2\right) .
\end{multline}
In other words we have:
\begin{multline}\label{vhfffngghkjgghDDnw22jkjkjkjk22gughhg22mm34int}
L\left(\vec A,\Psi,\vec v,\Phi_0,\vec h,\vec
x,t\right)=L_1\left(\vec A,\Psi,\vec v,\vec
x,t\right):=\\
\frac{1}{8\pi}\left|-\nabla_{\vec
x}\Psi-\frac{1}{c}\frac{\partial\vec A}{\partial t}+\frac{1}{c}\vec
v\times curl_{\vec x}\vec A\right|^2-\frac{1}{8\pi}\left|curl_{\vec
x}\vec A\right|^2-\left(\rho\Psi-\frac{1}{c}\vec A\cdot\vec
j\right)+\mu g\left(\frac{1}{2}\left|\vec u-\vec v\right|^2\right)
\\
-\frac{c^2}{8\pi G}\left|
\frac{1}{c}\nabla_{\vec x}\left(\frac{1}{2}\left|\vec v-\vec
k\right|^2\right)-\frac{1}{c}\frac{\partial}{\partial t}\left(\vec
v-\vec k\right)
+\frac{1}{c}\vec v\times curl_{\vec x}\left(\vec v-\vec
k\right)-\frac{1}{c}\nabla_{\vec x}\left(\left(\vec v-\vec
k\right)\cdot\vec v\right)\right|^2\\+\frac{c^2}{8\pi
G}\left|curl_{\vec x}\left(\vec v-\vec
k\right)\right|^2+\frac{c^2\beta}{4\pi
G}\left(\frac{1}{4}\left|d_{\vec x}\vec v+\left\{d_{\vec x}\vec
v\right\}^T\right|^2-\left|\Div_{\vec x}\vec v\right|^2\right).
\end{multline}
In particular, in the inertial coordinate system where $d_{\vec
x}\vec k=0$ and $\partial_t\vec k=0$ we have:
\begin{multline}\label{vhfffngghkjgghDDnw22jkjkjkjk22gughhg22hkhjhjhjh22kljlkjk22mma34int}
L\left(\vec A,\Psi,\vec v,\Phi_0,\vec h,\vec
x,t\right)=L_1\left(\vec A,\Psi,\vec v,\vec
x,t\right):=\\
\frac{1}{8\pi}\left|-\nabla_{\vec
x}\Psi-\frac{1}{c}\frac{\partial\vec A}{\partial t}+\frac{1}{c}\vec
v\times curl_{\vec x}\vec A\right|^2-\frac{1}{8\pi}\left|curl_{\vec
x}\vec A\right|^2-\left(\rho\Psi-\frac{1}{c}\vec A\cdot\vec
j\right)+\mu g\left(\frac{1}{2}\left|\vec u-\vec v\right|^2\right)
\\
-\frac{c^2}{8\pi G}\left|
-\frac{1}{c}\frac{\partial\vec v}{\partial t}
+\frac{1}{c}\vec v\times curl_{\vec x}\vec v-\frac{1}{c}\nabla_{\vec
x}\left(\frac{1}{2}\left|\vec
v\right|^2\right)\right|^2+\frac{c^2}{8\pi G}\left|curl_{\vec x}\vec
v\right|^2\\+\frac{c^2\beta}{4\pi G}\left(\frac{1}{4}\left|d_{\vec
x}\vec v+\left\{d_{\vec x}\vec v\right\}^T\right|^2-\left|\Div_{\vec
x}\vec v\right|^2\right),
\end{multline}
Note here the advantage of inertial coordinate systems, where $L$
and $L_1$ are complectly independent of the vectorial potential of
the inertia $\vec k$. Furthermore,
we rewrite
\er{vhfffngghkjgghDDnw22jkjkjkjk22gughhg22hkhjhjhjh22kljlkjk22mma34int}
as:
\begin{multline}\label{vhfffngghkjgghDDnw22jkjkjkjk22gughhg22hkhjhjhjh22kljlkjk22mm34int}
L\left(\vec A,\Psi,\vec v,\Phi_0,\vec h,\vec
x,t\right)=L_1\left(\vec A,\Psi,\vec v,\vec
x,t\right):=\\
\frac{1}{8\pi}\left|-\nabla_{\vec
x}\Psi-\frac{1}{c}\frac{\partial\vec A}{\partial t}+\frac{1}{c}\vec
v\times curl_{\vec x}\vec A\right|^2-\frac{1}{8\pi}\left|curl_{\vec
x}\vec A\right|^2-\left(\rho\Psi-\frac{1}{c}\vec A\cdot\vec
j\right)+\mu g\left(\frac{1}{2}\left|\vec u-\vec v\right|^2\right)
\\
-\frac{c^2}{8\pi G}\left|
-\frac{1}{c}\left(\frac{\partial\vec v}{\partial t}+d_{\vec x}\vec
v\cdot\vec v\right)\right|^2+\frac{c^2}{8\pi G}\left|curl_{\vec
x}\vec v\right|^2+\frac{c^2\beta}{4\pi
G}\left(\frac{1}{4}\left|d_{\vec x}\vec v+\left\{d_{\vec x}\vec
v\right\}^T\right|^2-\left|\Div_{\vec x}\vec v\right|^2\right).
\end{multline}
Then, using Proposition \ref{yghgjtgyrtrtint} by
\er{vhfffngghkjgghDDnw22mm34int},
\er{btfffygtgyggyDDnw22jkhk22mm34int} and
\er{vhfffngghkjgghDDnw22jkjkjkjk22gughhg22mm34int} we deduce that
$L$ and $L_1$ are invariant under the change of inertial or
non-inertial cartesian coordinate system, provided that $\vec h$ and
$\vec A$ are proper vector fields, $\vec v$ is a speed-like vector
field and $\Phi_0$ and $\Psi_0:=\Psi-\frac{1}{c}\vec A\cdot\vec v$
are proper scalar fields.

We point out, again, two the most important choices of function
$\mathcal{G}(s)$ in \er{vhfffngghkjgghDDnw22mm34mm34int}: fully
non-relativistic choice $\mathcal{G}(s)=\frac{s}{2}$ and
correspondingly $g(s)=\left(s-\frac{c^2}{2}\right)$; and
relativistic-like choice $\mathcal{G}(s)=\sqrt{s}$ and
correspondingly $g(s):=-c^2\sqrt{1-\frac{2s}{c^2}}$. Note also that
in the first case we have $\frac{dg}{ds}(s)=1$ and in the second
case
$\frac{dg}{ds}(s)=\left(1-\frac{2s}{c^2}\right)^{-\frac{1}{2}}\approx
1$, where the last equation is valid in the case where $2s\ll c^2$.

We investigate critical points of the functional
\begin{equation}\label{btfffygtgyggyDDnw22mm34int}
J=\int_0^T\int_{\mathbb{R}^3}L_1\left(\vec A,\Psi,\vec v,\vec
x,t\right)d\vec x dt.
\end{equation}
We denote
\begin{equation}\label{guigjgjffghDDnw22mm34int}
\begin{cases}
\Psi_0=\Psi-\frac{1}{c}\vec A\cdot\vec v\\
\vec D=-\nabla_{\vec x}\Psi-\frac{1}{c}\frac{\partial\vec
A}{\partial t}+\frac{1}{c}\vec v\times curl_{\vec x}\vec
A=-\nabla_{\vec x}\Psi_0-\frac{1}{c}\frac{\partial\vec A}{\partial
t}+\frac{1}{c}\vec
v\times curl_{\vec x}\vec A-\frac{1}{c}\nabla_{\vec x}\left(\vec A\cdot\vec v\right)\\
\vec B=curl_{\vec x}\vec A
\\
\vec E=-\nabla_{\vec x}\Psi-\frac{1}{c}\frac{\partial\vec A}{\partial t}=\vec D-\frac{1}{c}\vec v\times\vec B\\
\vec H=curl_{\vec x}\vec A+\frac{1}{c}\vec
v\times\left(-\nabla_{\vec x}\Psi-\frac{1}{c}\frac{\partial\vec
A}{\partial t}+\frac{1}{c}\vec v\times curl_{\vec x}\vec
A\right)=\vec B+\frac{1}{c}\vec v\times\vec D.
\end{cases}
\end{equation}
and
\begin{equation}\label{guigjgjffghDDnwjkhhkhkhjklh22mm34int}
\begin{cases}
\vec R=-\nabla_{\vec x}\Phi_0-\frac{1}{c}\frac{\partial\vec
h}{\partial t}
+\frac{1}{c}\vec v\times curl_{\vec x}\vec
h-\frac{1}{c}\nabla_{\vec x}\left(\vec h\cdot\vec v\right)\\
\vec Q=curl_{\vec x}\vec h,
\end{cases}
\end{equation}
where $\vec h$ is a proper vector field and $\Phi_0$ is a proper
scalar field that are given by
\er{btfffygtgyggyDDnw22jkhk22mm34int}. In other words,
\begin{equation}\label{guigjgjffghDDnwjkhhkhkhjklh22ljjljkhk22mm34int}
\begin{cases}
\vec R=\frac{1}{c}\nabla_{\vec x}\left(\frac{1}{2}\left|\vec v-\vec
k\right|^2\right)-\frac{1}{c}\frac{\partial}{\partial t}\left(\vec
v-\vec k\right)
+\frac{1}{c}\vec v\times curl_{\vec x}\left(\vec v-\vec
k\right)-\frac{1}{c}\nabla_{\vec x}\left(\left(\vec v-\vec
k\right)\cdot\vec v\right)\\
\vec Q=curl_{\vec x}(\vec v-\vec k),
\end{cases}
\end{equation}
and in inertial coordinate system where $d_{\vec x}\vec k=0$ and
$\partial_t\vec k=0$ we also have:
\begin{equation}\label{guigjgjffghDDnwjkhhkhkhjklh22ljjljkhk22hhmm34int}
\begin{cases}
\vec R=-\frac{1}{c}\left(\frac{\partial\vec v}{\partial t}+d_{\vec
x}\vec
v\cdot\vec v\right)\\
\vec Q=curl_{\vec x}\vec v.
\end{cases}
\end{equation}
Then in section \ref{ghjfhgfdhdnw22mm34} we obtain that a
configuration $\left(\vec A,\Psi,\vec v,\vec x,t\right)$ is a
critical point of the functional
\begin{equation}\label{btfffygtgyggyDDnw22mmjhkkhhkhintgjkgjfjfjhint}
J=\int_0^T\int_{\mathbb{R}^3}L_1\left(\vec A,\Psi,\vec v,\vec
x,t\right)d\vec x dt,
\end{equation}
if and only if it satisfies:
\begin{equation}\label{guigjgjffghguygjyfDDnw22mmd34int}
\begin{cases}
curl_{\vec x}\vec B=\frac{4\pi}{c}\vec
j+\frac{1}{c}\frac{\partial\vec D}{\partial t}-\frac{1}{c}curl_{\vec
x}\left(\vec v\times \vec D\right)
\\
div_{\vec x}\vec D=4\pi\rho
\\
curl_{\vec x}\vec D+\frac{1}{c}\frac{\partial\vec B}{\partial t}
-\frac{1}{c}\,curl_{\vec x}\left(\vec v\times\vec B\right)=0\\
div_{\vec x}\vec B=0\\
curl_{\vec x}\vec R+\frac{1}{c}\frac{\partial\vec Q}{\partial t}-\frac{1}{c}curl_{\vec x}\left(\vec v\times\vec Q\right)=0\\
div_{\vec x}\vec Q=0
\\
\frac{1}{c}\left(\frac{\partial}{\partial t}\left(\Div_{\vec x}\vec
v\right)+\vec v\cdot\nabla_{\vec x}\left(\Div_{\vec x}\vec
v\right)+\frac{1}{4}\left|d_{\vec x}\vec v+\left\{d_{\vec x}\vec
v\right\}^T\right|^2\right)=\frac{1}{2c}\left|\vec
Q\right|^2-\Div_{\vec x}\vec R
\\
curl_{\vec x}(curl_{\vec x}\vec v)=curl_{\vec x}\vec Q
\\
\frac{4\pi G}{c^2}\left(\mu g'\left(\frac{1}{2}\left|\vec u-\vec
v\right|^2\right)(\vec u-\vec v)+\frac{1}{4\pi c}\vec D\times\vec
B-\frac{c}{4\pi G}\vec R\times\vec Q\right)=\\
(1+\beta)\,curl_{\vec x}\vec Q
-\frac{1}{c}\left(\frac{\partial\vec R}{\partial t}-curl_{\vec
x}\left\{\vec v\times\vec R\right\}+\left(div_{\vec x}\vec
R\right)\vec v\right),
\end{cases}
\end{equation}
and by \er{guigjgjffghDDnwjkhhkhkhjklh22ljjljkhk22hhmm34int} in the
inertial frame we have:
\begin{equation}\label{guigjgjffghguygjyfDDnw22khhjhgg22mmd34int}
\begin{cases}
curl_{\vec x}\vec B=\frac{4\pi}{c}\vec
j+\frac{1}{c}\frac{\partial\vec D}{\partial t}-\frac{1}{c}curl_{\vec
x}\left(\vec v\times \vec
D\right)\\
div_{\vec x}\vec D=4\pi\rho\\
curl_{\vec x}\vec D+\frac{1}{c}\frac{\partial\vec B}{\partial t}
-\frac{1}{c}\,curl_{\vec x}\left(\vec v\times\vec B\right)=0\\
div_{\vec x}\vec B=0\\
\vec R=-\frac{1}{c}\left(\frac{\partial\vec v}{\partial t}+d_{\vec
x}\vec
v\cdot\vec v\right)\\
\vec Q=curl_{\vec x}\vec v
\\
\frac{4\pi G}{c^2}\left(\mu g'\left(\frac{1}{2}\left|\vec u-\vec
v\right|^2\right)(\vec u-\vec v)+\frac{1}{4\pi c}\vec D\times\vec
B-\frac{c}{4\pi G}\vec R\times\vec Q\right)=\\
(1+\beta)\,curl_{\vec x}\vec Q
-\frac{1}{c}\left(\frac{\partial\vec R}{\partial t}-curl_{\vec
x}\left\{\vec v\times\vec R\right\}+\left(div_{\vec x}\vec
R\right)\vec v\right).
\end{cases}
\end{equation}
Furthermore,  taking $\Div_{\vec x}$ of the both sides of the last
equality in \er{guigjgjffghguygjyfDDnw22mmd34int} and using
continuum equation $\partial_t\mu+\Div_{\vec x}\left(\mu\vec
u\right)=0$ we deduce
\begin{multline}\label{guigjgjffghguygjyfDDnw22mmz34int}
-\left(\partial_t\mu+\Div_{\vec x}\left(\mu\vec
v\right)\right)+\Div_{\vec x}\left\{\mu
\left(g'\left(\frac{1}{2}\left|\vec u-\vec
v\right|^2\right)-1\right)(\vec u-\vec v)+\frac{1}{4\pi c}\vec
D\times\vec B-\frac{c}{4\pi  G}\vec R\times\vec Q\right\}=\\
\Div_{\vec x}\left(\mu g'\left(\frac{1}{2}\left|\vec u-\vec
v\right|^2\right)(\vec u-\vec v)+\frac{1}{4\pi c}\vec D\times\vec
B-\frac{c}{4\pi  G}\vec R\times\vec Q\right)=\\
-\frac{c}{4\pi  G}\left(\frac{\partial}{\partial t}\left(\Div_{\vec
x}\vec R\right)+\Div_{\vec x}\left\{\left(div_{\vec x}\vec
R\right)\vec v\right\}\right),
\end{multline}
Therefore, considering the proper scalar quantity $Q_0$, that we
call the field mass, which satisfies
\begin{equation}\label{jtgiktiutiurtfirfmmhfffmm34int}
Q_0:=-\mu+\frac{c}{4\pi  G}\,\Div_{\vec x}\vec R,
\end{equation}
by \er{guigjgjffghguygjyfDDnw22mmz34int} we deduce
\begin{equation}\label{jtgiktiutiurtfirfmm34int}
\frac{\partial Q_0}{\partial t}+div_{\vec x}\left(Q_0\vec
v\right)=-\Div_{\vec x}\left\{\mu
\left(g'\left(\frac{1}{2}\left|\vec u-\vec
v\right|^2\right)-1\right)(\vec u-\vec v)+\frac{1}{4\pi c}\vec
D\times\vec B-\frac{c}{4\pi  G}\vec R\times\vec Q\right\}.
\end{equation}
Thus, we rewrite \er{guigjgjffghguygjyfDDnw22mmd34int} as:
\begin{equation}\label{guigjgjffghguygjyfDDnw22mm34int}
\begin{cases}
curl_{\vec x}\vec B=\frac{4\pi}{c}\left(\vec j-\rho\vec
v\right)+\frac{1}{c}\frac{\partial\vec D}{\partial
t}-\frac{1}{c}curl_{\vec x}\left(\vec v\times \vec
D\right)+\frac{1}{c}\left(div_{\vec x}\vec D\right)\vec v,
\\
div_{\vec x}\vec D=4\pi\rho,
\\
curl_{\vec x}\vec D+\frac{1}{c}\frac{\partial\vec B}{\partial t}
-\frac{1}{c}\,curl_{\vec x}\left(\vec v\times\vec B\right)=0,\\
div_{\vec x}\vec B=0,\\
curl_{\vec x}\vec R+\frac{1}{c}\frac{\partial\vec Q}{\partial t}-\frac{1}{c}curl_{\vec x}\left(\vec v\times\vec Q\right)=0,\\
div_{\vec x}\vec Q=0,
\\
\frac{4\pi G}{c^2}\left(\mu g'\left(\frac{1}{2}\left|\vec u-\vec
v\right|^2\right)(\vec u-\vec v)+\frac{1}{4\pi c}\vec D\times\vec
B-\frac{c}{4\pi G}\vec R\times\vec Q\right)=\\
(1+\beta)\,curl_{\vec x}\vec Q
-\frac{1}{c}\left(\frac{\partial\vec R}{\partial t}-curl_{\vec
x}\left\{\vec v\times\vec R\right\}+\left(div_{\vec x}\vec
R\right)\vec v\right),\\
\Div_{\vec x}\vec R=\frac{4\pi G}{c}(\mu+Q_0),
\\
\frac{1}{c}\left(\frac{\partial}{\partial t}\left(\Div_{\vec x}\vec
v\right)+\vec v\cdot\nabla_{\vec x}\left(\Div_{\vec x}\vec
v\right)+\frac{1}{4}\left|d_{\vec x}\vec v+\left\{d_{\vec x}\vec
v\right\}^T\right|^2\right)=\frac{1}{2c}\left|\vec
Q\right|^2-\Div_{\vec x}\vec R,
\\
curl_{\vec x}(curl_{\vec x}\vec v)=curl_{\vec x}\vec Q,
\\
\frac{\partial Q_0}{\partial t}+div_{\vec x}\left(Q_0\vec
v\right)=-\Div_{\vec x}\left\{\mu
\left(g'\left(\frac{1}{2}\left|\vec u-\vec
v\right|^2\right)-1\right)(\vec u-\vec v)+\frac{1}{4\pi c}\vec
D\times\vec B-\frac{c}{4\pi  G}\vec R\times\vec Q\right\},
\end{cases}
\end{equation}
and we rewrite \er{guigjgjffghguygjyfDDnw22khhjhgg22mmd34int} in the
inertial frame as:
\begin{equation}\label{guigjgjffghguygjyfDDnw22khhjhgg22mm34int}
\begin{cases}
curl_{\vec x}\vec B=\frac{4\pi}{c}\vec
j+\frac{1}{c}\frac{\partial\vec D}{\partial t}-\frac{1}{c}curl_{\vec
x}\left(\vec v\times \vec
D\right),\\
div_{\vec x}\vec D=4\pi\rho,\\
curl_{\vec x}\vec D+\frac{1}{c}\frac{\partial\vec B}{\partial t}
-\frac{1}{c}\,curl_{\vec x}\left(\vec v\times\vec B\right)=0,\\
div_{\vec x}\vec B=0,\\
\vec R=-\frac{1}{c}\left(\frac{\partial\vec v}{\partial t}+d_{\vec
x}\vec
v\cdot\vec v\right),\\
\vec Q=curl_{\vec x}\vec v,
\\
\frac{4\pi G}{c^2}\left(\mu g'\left(\frac{1}{2}\left|\vec u-\vec
v\right|^2\right)(\vec u-\vec v)+\frac{1}{4\pi c}\vec D\times\vec
B-\frac{c}{4\pi G}\vec R\times\vec Q\right)=\\
(1+\beta)\,curl_{\vec x}\vec Q
-\frac{1}{c}\left(\frac{\partial\vec R}{\partial t}-curl_{\vec
x}\left\{\vec v\times\vec R\right\}+\left(div_{\vec x}\vec
R\right)\vec v\right),\\
\Div_{\vec x}\vec R=\frac{4\pi G}{c} (\mu+Q_0),\\
\frac{\partial Q_0}{\partial t}+div_{\vec x}\left(Q_0\vec
v\right)=-\Div_{\vec x}\left\{\mu
\left(g'\left(\frac{1}{2}\left|\vec u-\vec
v\right|^2\right)-1\right)(\vec u-\vec v)+\frac{1}{4\pi c}\vec
D\times\vec B-\frac{c}{4\pi  G}\vec R\times\vec Q\right\}.
\end{cases}
\end{equation}
As before, by Proposition \ref{yghgjtgyrtrtint} we deduce that
\er{guigjgjffghguygjyfDDnw22mm34int} is invariant under the change
of inertial or non-inertial cartesian coordinate systems. Moreover,
\er{guigjgjffghguygjyfDDnw22khhjhgg22mm34int} is invariant under the
change of inertial cartesian coordinate systems. Note also that,
both, \er{guigjgjffghguygjyfDDnw22mm34int} in an arbitrary inertial
or non-inertial cartesian coordinate system and
\er{guigjgjffghguygjyfDDnw22khhjhgg22mm34int} in an arbitrary
inertial cartesian coordinate system, are complectly independent of
the vectorial potential of the inertia $\vec k$. Note again for the
last equation in \er{guigjgjffghguygjyfDDnw22mm34int} or
\er{guigjgjffghguygjyfDDnw22khhjhgg22mm34int} that: in the fully
non-relativistic case we have $g'(s)=1$ and in the relativistic-like
case we have
$g'(s)=\left(1-\frac{2s}{c^2}\right)^{-\frac{1}{2}}\approx 1$, where
the last equation is valid in the case where $2s\ll c^2$.

Finally, as before, note that in the case of large constant
$|\beta|\gg 1$ we have $\vec Q\to 0$ in
\er{guigjgjffghguygjyfDDnw22mm34int} and thus, the gravity equations
\er{guigjgjffghguygjyfDDnw22mm34int} reduce to the equations of the
Newtonian-type Gravity in the form of
\er{MaxVacFull1bjkgjhjhgjaaahkjhhjzzzyyykkknnnNWBWHWPPNNewCCmmhhjjhkkkint}.
In that case the gravity field propagates with the infinite speed.
On the other hand, in the case of vanishing constant $\beta=0$ the
form of equations for $\vec R$ and $\vec Q$ in
\er{guigjgjffghguygjyfDDnw22mm34int} is completely the same as the
form of the Maxwell equations for $\vec D$ and $\vec B$ in
\er{guigjgjffghguygjyfDDnw22mm34int}, except of the different
meaning of "charges" and "currents" in these two sets of equations.
In that case the electromagnetic and the gravity fields propagate
with the same speed. However, in the mixed case of constant
$\beta\sim 1$ the electromagnetic and the gravity fields propagate
with different finite speeds.

\subsubsection{Covariant formulation of the laws of gravity in
cartesian and curvilinear coordinate systems: The case of the
Newtonian type gravity} Next our purpose is to make the equivalent
form of the Lagrangian density of the Newtonian-lype gravity in
\er{vhfffngghkjgghDDmmint} to be covariant and valid in every
\underline{curvilinear} coordinate system.
%
%
%
%
%
%

Assume first that our coordinate system is inertial or more
generally non-inertial and cartesian. Then consider a
three-dimensional vectorial gravitational potential $\vec
v=(v^1,v^2,v^3)$ and consider the covariant pseudometric tensor
$G=\{g_{ij}\}_{0\leq i,j\leq 3}$ defined by
\er{hoyuiouigyfg3478zzrrZZffGGhhjhjint}:
\begin{equation}\label{hoyuiouigyfg3478zzrrZZffint}
\begin{cases}
g_{00}=1-\frac{|\vec v|^2}{c^2}\\
g_{ij}=-\delta_{ij}\quad\forall 1\leq i,j\leq 3\\
g_{0j}=g_{j0}=\frac{v^j}{c}\quad\forall 1\leq j\leq 3.
\end{cases}
\end{equation}
Next consider the contravariant pseudometric tensor $\tilde
G=\{g^{ij}\}_{0\leq i,j\leq 3}$ defined by
\er{hoyuiouigyfghgjh3478zzrrZZffGGjkkjint}:
\begin{equation}\label{hoyuiouigyfghgjh3478zzrrZZffint}
\begin{cases}
g^{00}=1\\
g^{ij}=-\delta_{ij}+\frac{v^iv^j}{c^2}\quad\forall 1\leq i,j\leq 3\\
g^{0j}=g^{j0}=\frac{v^j}{c}\quad\forall 1\leq j\leq 3.
\end{cases}
\end{equation}
Next consider the Christoffel Symbols:
\begin{equation}\label{jhffhgffgh3478zzrrZZffint}
\begin{cases}
\Gamma_{i,kn}:=\frac{1}{2}\left(\frac{\partial g_{ik}}{\partial
x_{n}}+\frac{\partial g_{in}}{\partial x_{k}}-\frac{\partial
g_{kn}}{\partial x_{i}}\right)\\
\Gamma^{i}_{kn}:=\sum_{j=0}^{3}g^{ij}\Gamma_{j,kn}
\end{cases}\quad\quad\forall\, i,k,n=0,1,2,3,
\end{equation}
where $\vec x=(x_1,x_2,x_3)$, $x_0=ct$ and the point in the four
dimensional space-time is denoted as $(x^0,x^1,x^2,x^3):=(ct,\vec
x)=(x_0,x_1,x_2,x_3)$. In particular, by
\er{hoyuiouigyfg3478zzrrZZffint} and by the first equation in
\er{jhffhgffgh3478zzrrZZffint} we obtain:
\begin{equation}\label{jhffhgffghhjgj3478zzrrZZffint}
\begin{cases}
\Gamma_{0,00}=-\frac{1}{2c^3}\,\frac{\partial (|\vec v|^2)}{\partial t}\\
\Gamma_{0,k0}=\Gamma_{0,0k}=-\frac{1}{2c^2}\frac{\partial (|\vec
v|^2)}{\partial
x_{k}}\quad\quad\forall\, k=1,2,3,\\
\Gamma_{0,kn}=\frac{1}{2c}\left(\frac{\partial v^{k}}{\partial
x_{n}}+\frac{\partial v^{n}}{\partial
x_{k}}\right)\quad\quad\forall\, k,n=1,2,3
\\
\Gamma_{i,00}=\frac{1}{c^2}\,\left(\frac{\partial v^{i}}{\partial
t}+\frac{1}{2}\frac{\partial (|\vec v|^2)}{\partial
x_{i}}\right)\quad\quad\forall\, i=1,2,3,
\\
\Gamma_{i,k0}=\Gamma_{i,0k}=\frac{1}{2c}\left(\frac{\partial
v^{i}}{\partial x_{k}}-\frac{\partial v^{k}}{\partial
x_{i}}\right)\quad\quad\forall\, i,k=1,2,3,
\\
\Gamma_{i,kn}=0\quad\quad\forall\, i,k,n=1,2,3.
\end{cases}
\end{equation}

Next consider the four-dimensional gravitational potential
$(v^0,v^1,v^2,v^3)$ defined by
\er{fgjfjhgghhgjghjhjijhojihjhjjijhjjjjjuiijjjkint} as:
\begin{equation}\label{fgjfjhgghhgjghjhjijhojihjhjjijhjjjjjuiijjjkhjhjhjklkkint}
(v^0,v^1,v^2,v^3):=\left(1,\frac{1}{c}\vec v\right),
\end{equation}
and the corresponding lowered four-covector field
$(v_0,v_1,v_2,v_3)$, that we called the four-covector of
gravitational potential:
\begin{equation}\label{fgjfjhgghhgjghjhjijhojihjhjjijhjjjjjuiikkjjnj1kkklint}
(v_0,v_1,v_2,v_3):=\left(1,0,0,0\right).
\end{equation}
Note again that by
\er{fgjfjhgghhgjghjhjijhojihjhjjijhjjjjjuiikkjjnj1kkklint} and
\er{fgjfjhgghhgjghjhjkkkkgjghghuiiiulkkjlkklplikklkjljghhgghkint}
the four-covector of gravitational potential is a gradient of the
global time multiplied by the constant $c$:
\begin{equation}\label{fgjfjhgghhgjghjhjijhojihjhjjijhjjjjjuiikkjjnj1kkkljkhjjhhjklkkint}
(v_0,v_1,v_2,v_3):=\left(1,0,0,0\right)=c\left(\frac{\partial
t}{\partial x^0},\frac{\partial t}{\partial x^1},\frac{\partial
t}{\partial x^2},\frac{\partial t}{\partial x^1}\right).
\end{equation}
Furthermore, let $\{{\Theta}^{ij}\}_{0\leq i,j\leq 3}$ be the
contravariant tensor of the three-dimensional geometry that
satisfies
\er{huohuioy89gjjhjffffff3478zzrrZZZhjhhjhhjjhhffGGhjjhuiiuihhjkkoioiint}
in every non-inertial cartesian coordinate system:
\begin{equation}\label{huohuioy89gjjhjffffff3478zzrrZZZhjhhjhhjjhhffGGhjjhuiiuihhjkkoioikkint}
\begin{cases}
{\Theta}^{00}=0
\\
{\Theta}^{0j}={\Theta}^{j0}=0\quad\quad\forall\, j=1,2,3
\\
{\Theta}^{ij}:=\delta_{ij}\quad\quad\forall\, i,j=1,2,3.
\end{cases}
\end{equation}
Moreover, by \er{hoyuiouigyfghgjh3478zzrrZZffGGjkkjojjint} we have:
\begin{equation}\label{hoyuiouigyfghgjh3478zzrrZZffGGjkkjojjlkklint}
g^{ij}:=v^iv^j-{\Theta}^{ij}\quad\quad\forall\,i,j=0,1,2,3,
\end{equation}
Then in subsection \ref{dvfgdgdkjg} we prove that we can write that
given a three-dimensional vectorial gravitational potential $\vec
v$, a proper scalar $\Phi$ and a proper three-dimensional vector
$\vec p$ we have:
\begin{multline}\label{hoyuiouigyfghgjh3478zzrrZZjhjhugyuyughggjhjhjyuuygtyffjjjkjljkjjljokjljjkjkopopkklklkjjkjkjjkkokkjoipio;l;l;ll;lkklkjjkkklkkkkkjjkjkouint}
\frac{1}{2}\left(d_{\vec x}\vec v+\left\{d_{\vec x}\vec
v\right\}^T\right):\left(d_{\vec x}\vec p+\left\{d_{\vec x}\vec
p\right\}^T\right)-2\left(div_{\vec x}\vec v\right)\left(div_{\vec
x}\vec p\right)\\+\frac{1}{4\pi G}\left(div_{\vec x}\vec
v\right)\left(\frac{\partial \Phi}{\partial t}+\vec
v\cdot\nabla_{\vec x} \Phi\right)+\frac{1}{4\pi
G}\Phi\left(div_{\vec x}\vec v\right)^2-\frac{\Phi}{16\pi
G}\left|d_{\vec x}\vec v+\left\{d_{\vec x}\vec
v\right\}^T\right|^2+\frac{1}{8\pi G}\left|\nabla_{\vec
x}\Phi\right|^2=\\
\sum_{m=0}^{3}\sum_{n=0}^{3}\frac{32\pi
G}{c^4}\,{\Theta}^{mn}\frac{\partial}{\partial
x^m}\left(\sum_{j=0}^{3}\sum_{k=0}^{3}g^{kj}v_kS_j\right)\frac{\partial}{\partial
x^n}\left(\sum_{j=0}^{3}\sum_{k=0}^{3}g^{kj}v_kS_j\right)\\+\left(\sum_{i=0}^{3}\sum_{j=0}^{3}g^{ij}\left(\delta_j
v_i+\delta_i
v_j\right)\right)\left(\sum_{i=0}^{3}\sum_{j=0}^{3}g^{ij}\left(\delta_j
S_i+\delta_i
S_j\right)\right)\\-\sum_{k=0}^{3}\sum_{j=0}^{3}\sum_{m=0}^{3}\sum_{n=0}^{3}g^{km}g^{jn}\left(\delta_j
S_k+\delta_k S_j\right)\left(\delta_m v_n+\delta_n v_m\right),
\end{multline}
where the four-covector field $(S_0,S_1,S_2,S_3)$ on the group
$\mathcal{S}_0$ and the corresponding lifted four-covector field
$(S^0,S^1,S^2,S^3)$ are given by
\begin{equation}\label{hoyuiouigyfghgjh3478zzrrZZjhjhhjkjjhnmnmffjjhjkjkjint}
\begin{cases}
S_0=\frac{c^2}{16\pi G}\Phi-\frac{1}{2}\vec v\cdot\vec p\\
S_j=\frac{c}{2}p_j\quad\quad\forall 1\leq j\leq
3\quad\quad\text{where}\quad(p_1,p_2,p_3):=\vec p,
\end{cases}
\end{equation}
and
\begin{equation}\label{hoyuiouigyfghgjh3478zzrrZZjhjhhjkjjhnmnmffjjkkkjgjojkjkint}
\begin{cases}
S^0=\frac{c^2}{16\pi G}\Phi\\
S^j=\frac{c^2}{16\pi G}\Phi
\frac{v^j}{c}-\frac{c}{2}p_j\quad\quad\forall 1\leq j\leq 3,
\end{cases}
\end{equation}
and where $\delta_jS_i$ and $\delta_jv_i$ mean the covariant
derivatives of the four-covectors $(S_0,S_1,S_2,S_3)$ and
$(v_0,v_1,v_2,v_3)$, which are known from the Tensor Analysis to be
two-times covariant tensors and defined by the following:
\begin{equation}\label{hoyuiouigyfghgjh3478zzrrZZjhjhugyuyughggjhjhjyuuygtyffjjint}
\begin{cases}
\delta_j S_i:=\frac{\partial S_i}{\partial
x_j}-\sum_{k=0}^{3}\Gamma^{k}_{ij}S_k\quad\quad\forall\, 0\leq
i,j\leq 3\\
\delta_j v_i:=\frac{\partial v_i}{\partial
x_j}-\sum_{k=0}^{3}\Gamma^{k}_{ij}v_k\quad\quad\forall\, 0\leq
i,j\leq 3.
\end{cases}
\end{equation}
Note here that the \underline{right} hand side of
\er{hoyuiouigyfghgjh3478zzrrZZjhjhugyuyughggjhjhjyuuygtyffjjjkjljkjjljokjljjkjkopopkklklkjjkjkjjkkokkjoipio;l;l;ll;lkklkjjkkklkkkkkjjkjkouint}
is written in a covariant form which is valid also in every
\underline{curvilinear} coordinate system.

Next, given a system of $n$ particles with inertial masses
$m_1,\ldots,m_n$, charges $\sigma_1,\ldots,\sigma_n$, places $\vec
r_1(t),\ldots,\vec r_n(t)$ and velocities $\frac{d\vec
r_1}{dt}(t),\ldots, \frac{d\vec r_n}{dt}(t)$ in the cartesian
coordinate system, the usual definitions of the charge density,
current density and the mass density of this system are the
following:
\begin{equation}\label{btfffjhgjghghijhhjhhhjjkjkkjint}
\begin{cases}
\rho(\vec
x,t):=\sum\limits_{j=1}^{n}\sigma_j\,\delta^{(3)}\left(\vec x-\vec
r_j(t)\right),\\
\vec j(\vec x,t):=\sum\limits_{j=1}^{n}\sigma_j\frac{d\vec
r_j}{dt}(t)\,\delta^{(3)}\left(\vec x-\vec r_j(t)\right),\\
\mu(\vec x,t):=\sum\limits_{j=1}^{n}m_j\,\delta^{(3)}\left(\vec
x-\vec r_j(t)\right),
\end{cases}
\end{equation}
where $\delta^{(3)}$ is the usual Dirac-delta distribution
(generalized function) in $\mathbb{\R}^3$. Then denoting
\begin{equation}\label{btfffjhgjghghijhhjhhhjjkjkkhhjint}
\left(x^0,x^1,x^2,x^3\right):=\left(t,\frac{1}{c}\vec x\right),
\end{equation}
denoting
$\left(\chi^0_j(t),\chi^1_j(t),\chi^2_j(t),\chi^3_j(t)\right)\in\mathbb{R}^4$
to be a four-dimensional space-time trajectory of the $j$-th
particle, parameterized by the global time, which is in cartesian
system defined by the following:
\begin{equation}\label{btfffjhgjghghijhhjhhhjjkjkkint}
\left(\chi^0_j(t),\chi^1_j(t),\chi^2_j(t),\chi^3_j(t)\right):=\left(t,\frac{1}{c}\vec
r_j(t)\right),
\end{equation}
denoting by $G$ the $4\times 4$-matrix $G:=\{g_{ij}\}_{0\leq i,j\leq
3}$, which satisfies $\text{det}\,G=-1$ in every cartesian
coordinate system, and denoting by $\left(j^0,j^1,j^2,j^3\right)$
the four vector of the current which is in cartesian system defined
by the following:
\begin{equation}\label{ihjkjlhint}
\left(j^0,j^1,j^2,j^3\right):=\left(\rho, \frac{1}{c}\vec
j\right)(\vec x,t),
\end{equation}
in subsection \ref{dvfgdgdkjg} we prove, that similarly to the
General Relativity, in every curvilinear coordinate system, where
$\frac{\partial t}{\partial x^0}>0$, we have:
\begin{multline}\label{btfffjhgjghghijhhjhhhjjkjkkjjhhhjkjjkjhjkllkjint}
\left(j^0,j^1,j^2,j^3\right):=\frac{1}{\sqrt{\left|\text{det}\,G\right|}}\left(\hat\rho, \frac{1}{c}\hat\rho\,\vec {\hat u}_{x^0}\right)\quad\text{where}\\
\hat\rho:=\sum\limits_{j=1}^{n}\sigma_j \,\delta^{(3)}\left(\vec
{\hat x}
-c\left(\chi^1_j,\chi^2_j,\chi^3_j\right)(\chi^0_j)\right)\\
\quad\text{is the local charge density, calculated in the
curvilinear coordinate system}.
\end{multline}
Here $\vec {\hat x}:=(cx^1,cx^2,cx^3)$ and $\vec{\hat u}_{x^0}$ is
the field of velocities of the system, calculated in a given
curvilinear coordinate system by the differentiation of the last
three coordinates of the particle: $\vec {\hat
r}:=(c\chi^1,c\chi^2,c\chi^3)$ by the coordinate $\chi^0$ that can
be considered as the local time instead of global time $t$.
Moreover, the quantity in
\er{btfffjhgjghghijhhjhhhjjkjkkjjhhhjkjjkjhjkllkjint} equals to a
four-vector, under the change of curvilinear coordinate systems.
\begin{remark}Note here that we denoted the matrix $G=\{g_{ij}\}_{0\leq
i,j\leq 3}$ by the same letter as the Gravitational Constant $G$.
However, there is no ambiguity, since in the second case $G$ is a
constant scalar and in the first case $G$ is a matrix. Moreover, we
will use the matrix notation $G=\{g_{ij}\}_{0\leq i,j\leq 3}$ only
in the expressions containing term $\text{det}\,G$.
\end{remark}
Similarly to \er{btfffjhgjghghijhhjhhhjjkjkkjjhhhjkjjkjhjkllkjint},
the following quantities, defined in every coordinate system where
$\frac{\partial t}{\partial x^0}>0$,  equals to a four-vector and a
covariant scalar respectively, under the change of curvilinear
coordinate systems:
\begin{multline}\label{btfffjhgjghghijhhjhhhjjkjkkjjhhhjkjjkjhjjkkjklkklijhjhjkmhggmint}
\frac{1}{\sqrt{\left|\text{det}\,G\right|}}\left(\hat\mu,
\frac{1}{c}\hat\mu\,\vec {\hat u}_{x^0}\right)\quad\text{and}\quad
\frac{\hat \mu}{\sqrt{\left|\text{det}\,G\right|}}
\sqrt{\left(\sum\limits_{m=0}^{3}\sum\limits_{k=0}^{3}g_{mk}\,\hat
u^m_{x^0}\,\hat u^k_{x^0}\right)}\quad\quad\quad
\quad\text{where}\\
\hat\mu:=\sum\limits_{j=1}^{n}m_j \,\delta^{(3)}\left(\vec {\hat x}
-c\left(\chi^1_j,\chi^2_j,\chi^3_j\right)(\chi^0_j)\right)\\
\quad\text{is the local mass density, calculated in the curvilinear
coordinate system},
\end{multline}
and $\left(\hat u^0,\hat u^1,\hat u^2,\hat
u^3\right)_{x^0}=\left(1,\frac{1}{c}\vec{\hat u}_{x^0}\right)$ is
the field of four dimensional velocities of the system, calculated
in a given curvilinear coordinate system by the differentiation of
the four dimensional coordinates of the particles by the first
coordinate $\chi^0$. Note here that although the quantities
$\hat\rho$ and $\hat\mu$ are not covariant scalars and $\left(\hat
u^0,\hat u^1,\hat u^2,\hat u^3\right)_{x^0}$ is not a four-vector,
the first quantity in
\er{btfffjhgjghghijhhjhhhjjkjkkjjhhhjkjjkjhjkllkjint} equals to a
four-vector and the two first quantities in
\er{btfffjhgjghghijhhjhhhjjkjkkjjhhhjkjjkjhjjkkjklkklijhjhjkmhggmint}
equal to a four-vector and a covariant scalar, under the change of
curvilinear coordinate systems. Moreover, clearly the four
dimensional speed $\left(u^0,u^1,u^2,u^3\right)_{t}$, obtained in
curvilinear coordinate system by the differentiation by the global
time $t$, instead of the first local coordinate $\chi_0$, indeed
forms a four-vector and therefore, the quantity
\begin{equation}\label{btfffjhgjghghijhhjhhhjjkjkkjjhhhjkjjkjhjjkkjklkklijhjhjkmhggmhjhjhkkkint}
\frac{\hat \mu}{\sqrt{\left|\text{det}\,G\right|}}
\left(\sum\limits_{m=0}^{3}\sum\limits_{k=0}^{3}g_{mk}\,\hat
u^m_{x^0}\,u^k_{t}\right)\quad\text{where}\;\;
\hat\mu=\sum\limits_{j=1}^{n}m_j \,\delta^{(3)}\left(\vec {\hat x}
-c\left(\chi^1_j,\chi^2_j,\chi^3_j\right)(\chi^0_j)\right)
\end{equation}
equals to a covariant scalar, under the change of curvilinear
coordinate systems.

Next we can write the density of the Lagrangian of the
electromagnetic field, defined in \er{vhfffngghkjgghPPNggjgjjkgjint}
in the equivalent form \er{vhfffngghkjgghPPNggjgjjkgjoiuioigghint},
where the right hand side is written in a covariant form which is
valid for every curvilinear coordinate system:
\begin{multline}\label{vhfffngghkjgghPPNggjgjjkgjoiuioigghghghfguuiint}
\frac{1}{4\pi}\left(\frac{1}{2}\left|\vec
D\right|^2-\frac{1}{2}\left|\vec
B\right|^2-4\pi\left(\rho\Psi-\frac{1}{c}\vec A\cdot\vec
j\right)\right)=\\
\frac{1}{4\pi}\left(-\sum_{n=0}^{3}\sum_{k=0}^{3}\sum_{m=0}^{3}\sum_{p=0}^{3}\frac{1}{4}g^{mn}g^{pk}\left(\frac{\partial
A_p}{\partial x^m}-\frac{\partial A_m}{\partial
x^p}\right)\left(\frac{\partial A_k}{\partial x^n}-\frac{\partial
A_n}{\partial x^k}\right)-\sum_{k=0}^{3}4\pi j^k A_k\right),
\end{multline}
where $(j^0,j^1,j^2,j^3)$ is the four-vector of the current that
satisfies \er{btfffjhgjghghijhhjhhhjjkjkkjjhhhjkjjkjhjkllkjint} in
every curvilinear coordinate system where $\frac{\partial
t}{\partial x^0}>0$,  and $(A_0,A_1,A_2,A_3)$ is the four-covector
of the electromagnetic potential. Then by
\er{vhfffngghkjgghPPNggjgjjkgjoiuioigghghghfguuiint},
\er{hoyuiouigyfghgjh3478zzrrZZjhjhugyuyughggjhjhjyuuygtyffjjjkjljkjjljokjljjkjkopopkklklkjjkjkjjkkokkjoipio;l;l;ll;lkklkjjkkklkkkkkjjkjkouint}
and
\er{btfffjhgjghghijhhjhhhjjkjkkjjhhhjkjjkjhjjkkjklkklijhjhjkmhggmhjhjhkkkint}
we rewrite the Lagrangian density in \er{vhfffngghkjgghDDmmint} in
the cartesian coordinate system as:
\begin{multline}\label{vhfffngghkjgghDDmmioiuiujhhjhjint}
\frac{1}{8\pi}\left|-\nabla_{\vec
x}\Psi-\frac{1}{c}\frac{\partial\vec A}{\partial t}+\frac{1}{c}\vec
v\times curl_{\vec x}\vec A\right|^2-\frac{1}{8\pi}\left|curl_{\vec
x}\vec A\right|^2-\left(\rho\Psi-\frac{1}{c}\vec A\cdot\vec
j\right)\\+\mu g\left(\frac{1}{2}\left|\vec u-\vec
v\right|^2\right)+
\frac{1}{2}\left(d_{\vec x}\vec v+\left\{d_{\vec x}\vec
v\right\}^T\right):\left(d_{\vec x}\vec p+\left\{d_{\vec x}\vec
p\right\}^T\right)-2\left(div_{\vec x}\vec v\right)\left(div_{\vec
x}\vec p\right)\\+\frac{1}{4\pi G}\left(div_{\vec x}\vec
v\right)\left(\frac{\partial \Phi}{\partial t}+\vec
v\cdot\nabla_{\vec x} \Phi\right)+\frac{1}{4\pi
G}\Phi\left(div_{\vec x}\vec v\right)^2-\frac{\Phi}{16\pi
G}\left|d_{\vec x}\vec v+\left\{d_{\vec x}\vec
v\right\}^T\right|^2+\frac{1}{8\pi G}\left|\nabla_{\vec
x}\Phi\right|^2=L_{ge}=
\\
\frac{1}{4\pi}\left(-\sum_{n=0}^{3}\sum_{k=0}^{3}\sum_{m=0}^{3}\sum_{p=0}^{3}\frac{1}{4}g^{mn}g^{pk}\left(\frac{\partial
A_p}{\partial x^m}-\frac{\partial A_m}{\partial
x^p}\right)\left(\frac{\partial A_k}{\partial x^n}-\frac{\partial
A_n}{\partial x^k}\right)-\sum_{k=0}^{3}4\pi j^k A_k\right)\\
+\frac{\hat \mu}{\sqrt{\left|\text{det}\,G\right|}}
\left(\sum\limits_{m=0}^{3}\sum\limits_{k=0}^{3}g_{mk}\,\hat
u^m_{x^0}\,u^k_{t}\right)\left(\sum\limits_{m=0}^{3}\sum\limits_{k=0}^{3}g_{mk}\,
u^m_{t}\,u^k_{t}\right)^{-1}g\left(\frac{c^2}{2}-\sum\limits_{m=0}^{3}\sum\limits_{k=0}^{3}\frac{c^2}{2}g_{mk}\,
u^m_{t}\,u^k_{t}\right)
\\
+\sum_{m=0}^{3}\sum_{n=0}^{3}\frac{32\pi
G}{c^4}\,{\Theta}^{mn}\frac{\partial}{\partial
x^m}\left(\sum_{j=0}^{3}\sum_{k=0}^{3}g^{kj}v_kS_j\right)\frac{\partial}{\partial
x^n}\left(\sum_{j=0}^{3}\sum_{k=0}^{3}g^{kj}v_kS_j\right)\\+\left(\sum_{i=0}^{3}\sum_{j=0}^{3}g^{ij}\left(\delta_j
v_i+\delta_i
v_j\right)\right)\left(\sum_{i=0}^{3}\sum_{j=0}^{3}g^{ij}\left(\delta_j
S_i+\delta_i
S_j\right)\right)\\-\sum_{k=0}^{3}\sum_{j=0}^{3}\sum_{m=0}^{3}\sum_{n=0}^{3}g^{km}g^{jn}\left(\delta_j
S_k+\delta_k S_j\right)\left(\delta_m v_n+\delta_n v_m\right),
\end{multline}
where the right hand side is written in the covariant form and the
second equality is valid in every curvilinear coordinate system with
$\frac{\partial t}{\partial x^0}>0$, $(j^0,j^1,j^2,j^3)$ is the
four-vector of the current, that satisfies
\er{btfffjhgjghghijhhjhhhjjkjkkjjhhhjkjjkjhjkllkjint}, $\hat\mu$ is
given by
\er{btfffjhgjghghijhhjhhhjjkjkkjjhhhjkjjkjhjjkkjklkklijhjhjkmhggmhjhjhkkkint}
and in a cartesian coordinate system we have:
\begin{equation}\label{hoyuiouigyfghgjh3478zzrrZZjhjhhjkjjhnmnmffjjhjkjkjjhhjint}
\begin{cases}
S_0=\frac{c^2}{16\pi G}\Phi-\frac{1}{2}\vec v\cdot\vec p\\
S_j=\frac{c}{2}p_j\quad\quad\forall 1\leq j\leq 3.
\end{cases}
\end{equation}
and
\begin{equation}\label{hoyuiouigyfghgjh3478zzrrZZjhjhhjkjjhnmnmffjjkkkjgjojkjkkjjhint}
\begin{cases}
S^0=\frac{c^2}{16\pi G}\Phi\\
S^j=\frac{c^2}{16\pi G}\Phi
\frac{v^j}{c}-\frac{c}{2}p_j\quad\quad\forall 1\leq j\leq 3.
\end{cases}
\end{equation}
Moreover in the particular case of the relativistic-like choice
$g(s):=-c^2\sqrt{1-\frac{2s}{c^2}}$, by
\er{btfffjhgjghghijhhjhhhjjkjkkjjhhhjkjjkjhjjkkjklkklijhjhjkmhggmint}
we can write an alternative to
\er{vhfffngghkjgghDDmmioiuiujhhjhjint} as:
\begin{multline}\label{vhfffngghkjgghDDmmioiuiujhhjhjjkjkint}
\frac{1}{8\pi}\left|-\nabla_{\vec
x}\Psi-\frac{1}{c}\frac{\partial\vec A}{\partial t}+\frac{1}{c}\vec
v\times curl_{\vec x}\vec A\right|^2-\frac{1}{8\pi}\left|curl_{\vec
x}\vec A\right|^2-\left(\rho\Psi-\frac{1}{c}\vec A\cdot\vec
j\right)\\+\mu g\left(\frac{1}{2}\left|\vec u-\vec
v\right|^2\right)+
\frac{1}{2}\left(d_{\vec x}\vec v+\left\{d_{\vec x}\vec
v\right\}^T\right):\left(d_{\vec x}\vec p+\left\{d_{\vec x}\vec
p\right\}^T\right)-2\left(div_{\vec x}\vec v\right)\left(div_{\vec
x}\vec p\right)\\+\frac{1}{4\pi G}\left(div_{\vec x}\vec
v\right)\left(\frac{\partial \Phi}{\partial t}+\vec
v\cdot\nabla_{\vec x} \Phi\right)+\frac{1}{4\pi
G}\Phi\left(div_{\vec x}\vec v\right)^2-\frac{\Phi}{16\pi
G}\left|d_{\vec x}\vec v+\left\{d_{\vec x}\vec
v\right\}^T\right|^2+\frac{1}{8\pi G}\left|\nabla_{\vec
x}\Phi\right|^2=L_{ge}=\\
\frac{1}{4\pi}\left(-\sum_{n=0}^{3}\sum_{k=0}^{3}\sum_{m=0}^{3}\sum_{p=0}^{3}\frac{1}{4}g^{mn}g^{pk}\left(\frac{\partial
A_p}{\partial x^m}-\frac{\partial A_m}{\partial
x^p}\right)\left(\frac{\partial A_k}{\partial x^n}-\frac{\partial
A_n}{\partial x^k}\right)-\sum_{k=0}^{3}4\pi j^k A_k\right)\\
-\frac{c^2\hat \mu}{\sqrt{\left|\text{det}\,G\right|}}
\sqrt{\left(\sum\limits_{m=0}^{3}\sum\limits_{k=0}^{3}g_{mk}\,\hat
u^m_{x^0}\,\hat u^k_{x^0}\right)}
\\+\sum_{m=0}^{3}\sum_{n=0}^{3}\frac{32\pi
G}{c^4}\,{\Theta}^{mn}\frac{\partial}{\partial
x^m}\left(\sum_{j=0}^{3}\sum_{k=0}^{3}g^{kj}v_kS_j\right)\frac{\partial}{\partial
x^n}\left(\sum_{j=0}^{3}\sum_{k=0}^{3}g^{kj}v_kS_j\right)\\+\left(\sum_{i=0}^{3}\sum_{j=0}^{3}g^{ij}\left(\delta_j
v_i+\delta_i
v_j\right)\right)\left(\sum_{i=0}^{3}\sum_{j=0}^{3}g^{ij}\left(\delta_j
S_i+\delta_i
S_j\right)\right)\\-\sum_{k=0}^{3}\sum_{j=0}^{3}\sum_{m=0}^{3}\sum_{n=0}^{3}g^{km}g^{jn}\left(\delta_j
S_k+\delta_k S_j\right)\left(\delta_m v_n+\delta_n v_m\right),
\end{multline}
where the right hand side is written in the covariant form and the
second equality is valid in every curvilinear coordinate system with
$\frac{\partial t}{\partial x^0}>0$. On the other hand, in the case
of fully non-relativistic Lagrangian, where
$g(s)=\left(s-\frac{c^2}{2}\right)$, we can write an alternative to
\er{vhfffngghkjgghDDmmioiuiujhhjhjint} as:
\begin{multline}\label{vhfffngghkjgghDDmmioiuiujhhjhjklkklint}
\frac{1}{8\pi}\left|-\nabla_{\vec
x}\Psi-\frac{1}{c}\frac{\partial\vec A}{\partial t}+\frac{1}{c}\vec
v\times curl_{\vec x}\vec A\right|^2-\frac{1}{8\pi}\left|curl_{\vec
x}\vec A\right|^2-\left(\rho\Psi-\frac{1}{c}\vec A\cdot\vec
j\right)\\+\mu g\left(\frac{1}{2}\left|\vec u-\vec
v\right|^2\right)+
\frac{1}{2}\left(d_{\vec x}\vec v+\left\{d_{\vec x}\vec
v\right\}^T\right):\left(d_{\vec x}\vec p+\left\{d_{\vec x}\vec
p\right\}^T\right)-2\left(div_{\vec x}\vec v\right)\left(div_{\vec
x}\vec p\right)\\+\frac{1}{4\pi G}\left(div_{\vec x}\vec
v\right)\left(\frac{\partial \Phi}{\partial t}+\vec
v\cdot\nabla_{\vec x} \Phi\right)+\frac{1}{4\pi
G}\Phi\left(div_{\vec x}\vec v\right)^2-\frac{\Phi}{16\pi
G}\left|d_{\vec x}\vec v+\left\{d_{\vec x}\vec
v\right\}^T\right|^2+\frac{1}{8\pi G}\left|\nabla_{\vec
x}\Phi\right|^2=L_{ge}=\\
\frac{1}{4\pi}\left(-\sum_{n=0}^{3}\sum_{k=0}^{3}\sum_{m=0}^{3}\sum_{p=0}^{3}\frac{1}{4}g^{mn}g^{pk}\left(\frac{\partial
A_p}{\partial x^m}-\frac{\partial A_m}{\partial
x^p}\right)\left(\frac{\partial A_k}{\partial x^n}-\frac{\partial
A_n}{\partial x^k}\right)-\sum_{k=0}^{3}4\pi j^k A_k\right)\\
-\frac{c^2\hat \mu}{\sqrt{\left|\text{det}\,G\right|}}
\left(\sum\limits_{m=0}^{3}\sum\limits_{k=0}^{3}g_{mk}\,\hat
u^m_{x^0}\,u^k_{t}\right)
\\+\sum_{m=0}^{3}\sum_{n=0}^{3}\frac{32\pi
G}{c^4}\,{\Theta}^{mn}\frac{\partial}{\partial
x^m}\left(\sum_{j=0}^{3}\sum_{k=0}^{3}g^{kj}v_kS_j\right)\frac{\partial}{\partial
x^n}\left(\sum_{j=0}^{3}\sum_{k=0}^{3}g^{kj}v_kS_j\right)\\+\left(\sum_{i=0}^{3}\sum_{j=0}^{3}g^{ij}\left(\delta_j
v_i+\delta_i
v_j\right)\right)\left(\sum_{i=0}^{3}\sum_{j=0}^{3}g^{ij}\left(\delta_j
S_i+\delta_i
S_j\right)\right)\\-\sum_{k=0}^{3}\sum_{j=0}^{3}\sum_{m=0}^{3}\sum_{n=0}^{3}g^{km}g^{jn}\left(\delta_j
S_k+\delta_k S_j\right)\left(\delta_m v_n+\delta_n v_m\right),
\end{multline}
where again the right hand side is written in the covariant form and
the second equality is valid in every curvilinear coordinate system
with $\frac{\partial t}{\partial x^0}>0$.

Once we wrote the Lagrangian density
$L_{ge}:=L\left(S^k,v^k,A^k,x^k\right)_{k=0,\ldots 4}$ as a
covariant scalar, under the changes of curvilinear coordinate
systems, we can write a covariant Lagrangian as:
\begin{equation}\label{uuuuuuuuint}
J_{ge}(S^k,v^k,A^k):=\int_{(x^0,x^1,x^2,x^3)}L_{ge}\left(S^k,v^k,A^k,x^k\right)\,\sqrt{\left|\text{det}\,G\right|}\,dx^0dx^1dx^2dx^3.
\end{equation}
Although we need a term $\sqrt{\left|\text{det}\,G\right|}$ for the
covariance of the Lagrangian in curvilinear coordinate systems, in
the cartesian coordinate systems we always have
$\sqrt{\left|\text{det}\,G\right|}=1$.

Next note that the contravariant tensor of the three-dimensional
geometry $\Theta^{ij}$ which satisfies
\er{huohuioy89gjjhjffffff3478zzrrZZZhjhhjhhjjhhffGGhjjhuiiuihhjkkoioikkint}
in non-inertial cartesian coordinate systems and the scalar of the
global time $t$ are dependent on the geometry of the space-time only
and are fully determined in a given curvilinear coordinate system by
change of variables rules. In particular, the
four-\underline{covector} of the gravitational potential
$(v_0,v_1,v_2,v_3)$ is fully determined in the given curvilinear
coordinate system, since we have:
\begin{equation}\label{fgjfjhfgfgfgfhihhjhhjjhhjhjjhjhhjhjhjhjhjhhjhjint}
v_k=c\frac{\partial t}{\partial x^k}\quad\quad\forall\, k=0,1,2,3.
\end{equation}
Moreover, by
\er{fgjfjhgghhgjghjhjijhojihjhjjijhjjjjjuiijjjkhjhjhjuiiuuuyuuoiuuiioint}
and \er{fgjfjhfgfgfgfhihhjhhjjhhjhjint} we have the following
covariant identities which are valid in every curvilinear coordinate
system:
\begin{equation}\label{fgjfjhfgfgfgfhihhjhhjjhhjhjjhjhhjhjint}
\begin{cases}
\sum_{k=0}^{3}{\Theta}^{mk}v_k=\sum_{k=0}^{3}c\,{\Theta}^{mk}\frac{\partial
t}{\partial x^k}=0\quad\quad\forall\, m=0,1,2,3\\
\sum_{k=0}^{3}\sum_{j=0}^{3}g^{kj}v_kv_j=\sum_{k=0}^{3}\sum_{j=0}^{3}c^2g^{kj}\frac{\partial
t}{\partial x^k}\frac{\partial t}{\partial x^j}=1.
\end{cases}
\end{equation}
However the four-\underline{vector} of the gravitational potential
$(v^0,v^1,v^2,v^3)$, the contravariant pseudometric tensor
$g^{mn}=v^mv^n-\Theta^{mn}$ and thus also the covariant pseudometric
tensor $g_{mn}$ depend also on the physical properties of the
matter. Without knowing the physical properties of the matter the
four-vector of the gravitational potential can be arbitrary vector
$(v^0,v^1,v^2,v^3)$ that satisfies:
\begin{equation}\label{fgjfjhfgfgfgfhihhjhhjjhhjhjjhjhhjhjkjhint}
\sum_{k=0}^{3}v_kv^k=\sum_{k=0}^{3}c\frac{\partial t}{\partial
x^k}v^k
=1.
\end{equation}
Indeed for an arbitrary four-vector $(v^0,v^1,v^2,v^3)$ that
satisfies \er{fgjfjhfgfgfgfhihhjhhjjhhjhjjhjhhjhjkjhint}, denoting
$g^{mn}:=v^mv^n-\Theta^{mn}$, using
\er{fgjfjhfgfgfgfhihhjhhjjhhjhjjhjhhjhjint} and
\er{fgjfjhfgfgfgfhihhjhhjjhhjhjjhjhhjhjkjhint} we clearly obtain the
following consistency:
\begin{equation}\label{fgjfjhfgfgfgfhihhjhhjjhhjhjjhjhhjhjkjhjjkjkkint}
\sum_{j=0}^{3}g^{kj}v_j=\sum_{j=0}^{3}\left(v^kv^j-\Theta^{kj}\right)v_j=v^k\left(\sum_{j=0}^{3}v^jv_j\right)-\sum_{j=0}^{3}\Theta^{kj}v_j=v^k\quad\quad\forall
k=0,1,2,3.
\end{equation}
Thus we obtained that the four-vector of the gravitational potential
can be arbitrary four vector in \er{uuuuuuuuint} that satisfies the
linear constraint \er{fgjfjhfgfgfgfhihhjhhjjhhjhjjhjhhjhjkjhint}
where the four-covector $v_k$ is prescribed. So the four-vector
$(v^0,v^1,v^2,v^3)$ actually contains three independent scalar
functions similarly as in cartesian coordinate systems where we have
$v^0=1$. On the other hand, the four-vector $S^k$ contains four
independent scalar functions. Thus the Lagrangian in
\er{uuuuuuuuint} depends on seven independent scalar functions
characterizing the gravitational field and the four-vector of
electromagnetic potential, exactly as in cartesian coordinate
systems where we have four independent scalar functions that
characterize the electromagnetic field: scalar $\Psi$ and
three-dimensional vector $\vec A$ and seven independent scalar
functions that characterize the gravitational field: three are
contained in the three-dimensional vectorial gravitational potential
$\vec v$ and other four are the ancillary scalar field $\Phi$ and
the ancillary three-dimensional vector field $\vec p$.

Finally note that since our model of the Newtonian-type gravity in
the case of fully non-relativistic choice
\er{vhfffngghkjgghDDmmioiuiujhhjhjklkklint} and in the absence of
electromagnetic fields coincides with the classical Newtonian
gravitation, as a particular case, we obtained a covariant
formulation of the classical Newtonian gravity in curvilinear
coordinate systems.

\subsubsection{Covariant formulation of the laws of gravity in
cartesian and curvilinear coordinate systems: The case of some
alternative model of the
gravity}\label{gggggyffu997tg7gffttft999int}

Consider $\vec k$ to be the vectorial potential of the inertia,
which is a generally trivial speed-like vector field, assumed to be
fixed in every fixed inertial or non-inertial cartesian coordinate
system (see Definition \ref{jjhgyftdtdint}). In this subsection we
find a equivalent form of the Lagrangian density of the
gravitational-electromagnetic field in the case of the alternative
model of the gravity gravity having the general form
\er{vhfffngghkjgghDDnw22mm34int} or
\er{vhfffngghkjgghDDnw22jkjkjkjk22gughhg22mm34int}:
\begin{multline}\label{vhfffngghkjgghDDnw22mm3488int}
L\left(\vec A,\Psi,\vec v,\Phi_0,\vec h,\vec x,t\right):=\\
\frac{1}{8\pi}\left|-\nabla_{\vec
x}\Psi-\frac{1}{c}\frac{\partial\vec A}{\partial t}+\frac{1}{c}\vec
v\times curl_{\vec x}\vec A\right|^2-\frac{1}{8\pi}\left|curl_{\vec
x}\vec A\right|^2-\left(\rho\Psi-\frac{1}{c}\vec A\cdot\vec j\right)
+\mu g\left(\frac{1}{2}\left|\vec u-\vec v\right|^2\right)
\\
-\frac{c^2}{8\pi G}\left|-\nabla_{\vec x}\Phi_0
-\frac{1}{c}\frac{\partial\vec h}{\partial t}
+\frac{1}{c}\vec v\times curl_{\vec x}\vec h-\frac{1}{c}\nabla_{\vec
x}\left(\vec h\cdot\vec v\right)\right|^2+\frac{c^2}{8\pi
G}\left|curl_{\vec x}\vec h\right|^2\\+\frac{c^2\beta}{4\pi
G}\left(\frac{1}{4}\left|d_{\vec x}\vec v+\left\{d_{\vec x}\vec
v\right\}^T\right|^2-\left|\Div_{\vec x}\vec v\right|^2\right),
\end{multline}
where
\begin{equation}\label{btfffygtgyggyDDnw22jkhk22mm3488int}
\vec h=\vec v-\vec
k\quad\text{and}\quad\Phi_0=-\frac{1}{2c}\left|\vec v-\vec
k\right|^2,
\end{equation}
so that we have
\begin{multline}\label{vhfffngghkjgghDDnw22jkjkjkjk22gughhg22mm3488int}
L\left(\vec A,\Psi,\vec v,\Phi_0,\vec h,\vec
x,t\right)=L_1\left(\vec A,\Psi,\vec v,\vec
x,t\right):=\\
\frac{1}{8\pi}\left|-\nabla_{\vec
x}\Psi-\frac{1}{c}\frac{\partial\vec A}{\partial t}+\frac{1}{c}\vec
v\times curl_{\vec x}\vec A\right|^2-\frac{1}{8\pi}\left|curl_{\vec
x}\vec A\right|^2-\left(\rho\Psi-\frac{1}{c}\vec A\cdot\vec
j\right)+\mu g\left(\frac{1}{2}\left|\vec u-\vec v\right|^2\right)
\\
-\frac{c^2}{8\pi G}\left|
\frac{1}{c}\nabla_{\vec x}\left(\frac{1}{2}\left|\vec v-\vec
k\right|^2\right)-\frac{1}{c}\frac{\partial}{\partial t}\left(\vec
v-\vec k\right)
+\frac{1}{c}\vec v\times curl_{\vec x}\left(\vec v-\vec
k\right)-\frac{1}{c}\nabla_{\vec x}\left(\left(\vec v-\vec
k\right)\cdot\vec v\right)\right|^2\\+\frac{c^2}{8\pi
G}\left|curl_{\vec x}\left(\vec v-\vec
k\right)\right|^2+\frac{c^2\beta}{4\pi
G}\left(\frac{1}{4}\left|d_{\vec x}\vec v+\left\{d_{\vec x}\vec
v\right\}^T\right|^2-\left|\Div_{\vec x}\vec v\right|^2\right).
\end{multline}
Our purpose is to make the equivalent form of this Lagrangian
density to be covariant and valid in every \underline{curvilinear}
coordinate system. Assume first, that we deal with a cartesian
inertial or non-inertial coordinate system. Consider again the
three-dimensional vectorial gravitational potential $\vec
v=(v^1,v^2,v^3)$ and consider the covariant pseudometric tensor
$G=\{g_{ij}\}_{0\leq i,j\leq 3}$ defined by
\er{hoyuiouigyfg3478zzrrZZffint} and the contravariant pseudometric
tensor $\tilde G=\{g^{ij}\}_{0\leq i,j\leq 3}$ defined by
\er{hoyuiouigyfghgjh3478zzrrZZffint}. Next, as before, we consider
the four-dimensional gravitational potential $(v^0,v^1,v^2,v^3)$
defined by
\er{fgjfjhgghhgjghjhjijhojihjhjjijhjjjjjuiijjjkhjhjhjklkkint} and
the corresponding lowered four-covector field $(v_0,v_1,v_2,v_3)$,
that we called the four-covector of gravitational potential, defined
by \er{fgjfjhgghhgjghjhjijhojihjhjjijhjjjjjuiikkjjnj1kkklint}. Then,
we have
\begin{multline}\label{hoyuiouigyfghgjh3478zzrrZZjhjhugyuyughggjhjhjyuuygtyffjjjkjljkjjljokjljjkjkopopkklklkjjkjkjjkkokkjoipio;l;l;ll;lkklkjjkkklkkkkkjojjhint}
\left(\left(div_{\vec x}\vec v\right)^2-\frac{1}{4}\left|d_{\vec
x}\vec v+\left\{d_{\vec x}\vec v\right\}^T\right|^2\right)=\\
\frac{c^2}{4}\left(\left(\sum_{i=0}^{3}\sum_{j=0}^{3}g^{ij}\left(\delta_j
v_i+\delta_i
v_j\right)\right)^2-\sum_{k=0}^{3}\sum_{j=0}^{3}\sum_{m=0}^{3}\sum_{n=0}^{3}\left(\delta_j
v_k+\delta_k v_j\right)g^{km}g^{jn}\left(\delta_m v_n+\delta_n
v_m\right)\right),
\end{multline}
where $\delta_j$ means the covariant derivative with respect to the
dynamical pseudo-metrics $G=\{g_{ij}\}_{0\leq i,j\leq 3}$ (see
subsection \ref{gggggyffu997tg7gffttft999} for the details).
Furthermore, consider the Dynamical four-covector of genuine gravity
$(s_0,s_1,s_2,s_3)$, defined as in
\er{fgjfjhgghhgjghjhjijhojihjhjjijhjjjjjuiijjjk88int} by:
\begin{multline}\label{fgjfjhgghhgjghjhjijhojihjhjjijhjjjjjuiijjjk88jihhhiint}
(s_0,s_1,s_2,s_3)
=-\frac{1}{c}\left(\left(\Phi_0+\frac{1}{c}\vec v\cdot\vec
h\right),-\vec h\right)=\left(\frac{1}{2c^2}|\vec
h|^2-\frac{1}{c^2}\vec v\cdot\vec h\,,\,\frac{1}{c}\vec
h\right)\quad\quad\text{where}\quad\\ s_0
=-\frac{1}{c}\left(\Phi_0+\frac{1}{c}\vec v\cdot\vec
h\right)=\frac{1}{2c^2}|\vec h|^2-\frac{1}{c^2}\vec v\cdot\vec
h\;\;\quad\text{and}\quad\;\;(s_1,s_2,s_3)=\frac{1}{c}\vec h,
\end{multline}
where $\vec h$ and $\Phi_0$ are given by
\er{btfffygtgyggyDDnw22jkhk22mm3488int}. Next, as in
\er{MaxVacFullPPNhjjghjjkjhhiuyyint} and
\er{vhfffngghkjgghPPNggjgjjkgjoiuioigghint} we have the following:
\begin{multline}\label{vhfffngghkjgghPPNggjgjjkgjoiuijouuiyiyyyuyuint}
\frac{1}{4\pi}\left(\frac{1}{2}\left|-\nabla_{\vec
x}\Psi-\frac{1}{c}\frac{\partial\vec A}{\partial t}+\frac{1}{c}\vec
v\times curl_{\vec x}\vec A\right|^2-\frac{1}{2}\left|curl_{\vec
x}\vec A\right|^2-4\pi\left(\rho\Psi-\frac{1}{c}\vec A\cdot\vec
j\right)\right)=\\
\frac{1}{4\pi}\left(-\sum_{n=0}^{3}\sum_{k=0}^{3}\sum_{m=0}^{3}\sum_{p=0}^{3}\frac{1}{4}g^{mn}g^{pk}\left(\frac{\partial
A_p}{\partial x^m}-\frac{\partial A_m}{\partial
x^p}\right)\left(\frac{\partial A_k}{\partial x^n}-\frac{\partial
A_n}{\partial x^k}\right)-\sum_{k=0}^{3}4\pi j^k A_k\right),
\end{multline}
where $(A_0,A_1,A_2,A_3)$ is the four-covector of thefour
dimensional electromagnetic potential, defined as in
\er{fgjfjhgghhgjghjhjijhojihjhjjijhjjjljljpkint} by:
\begin{equation}\label{fgjfjhgghhgjghjhjijhojihjhjjijhjjjljljpkjkjkjojint}
(A_0,A_1,A_2,A_3)=(\Psi,-\vec A)\quad\text{where}\quad
A_0=\Psi\;\;\text{and}\;\;(A_1,A_2,A_3)=-\vec A,
\end{equation}
and $(j^0,j^1,j^2,j^3)$ is the four-vector of the current, given by
\er{btfffjhgjghghijhhjhhhjjkjkkjjhhhjkjjkjhjkllkjint}. Then,
completely analogously, as we get in
\er{MaxVacFullPPNhjjghjjkjhhiuyyint} and
\er{vhfffngghkjgghPPNggjgjjkgjoiuioigghint} the following part of
\er{vhfffngghkjgghPPNggjgjjkgjoiuijouuiyiyyyuyuint}:
\begin{multline}\label{vhfffngghkjgghPPNggjgjjkgjoiuipopiioioint}
\left(\frac{1}{2}\left|-\nabla_{\vec
x}\Psi-\frac{1}{c}\frac{\partial\vec A}{\partial t}+\frac{1}{c}\vec
v\times curl_{\vec x}\vec A\right|^2-\frac{1}{2}\left|curl_{\vec
x}\vec A\right|^2\right)=\\
-\sum_{n=0}^{3}\sum_{k=0}^{3}\sum_{m=0}^{3}\sum_{p=0}^{3}\frac{1}{4}g^{mn}g^{pk}\left(\frac{\partial
A_p}{\partial x^m}-\frac{\partial A_m}{\partial
x^p}\right)\left(\frac{\partial A_k}{\partial x^n}-\frac{\partial
A_n}{\partial x^k}\right),
\end{multline}
we can get also:
\begin{multline}\label{vhfffngghkjgghPPNggjgjjkgjoiuipopiioioouuouint}
\left(\frac{1}{2}\left|-\nabla_{\vec x}\Phi_0
-\frac{1}{c}\frac{\partial\vec h}{\partial t}
+\frac{1}{c}\vec v\times curl_{\vec x}\vec h-\frac{1}{c}\nabla_{\vec
x}\left(\vec h\cdot\vec
v\right)\right|^2-\frac{1}{2}\left|curl_{\vec
x}\vec h\right|^2\right)=\\
\left(\frac{1}{2}\left|-\nabla_{\vec x}\left(\Phi_0+\frac{1}{c}\vec
v\cdot\vec h\right) -\frac{1}{c}\frac{\partial\vec h}{\partial t}
+\frac{1}{c}\vec v\times curl_{\vec x}\vec
h\right|^2-\frac{1}{2}\left|curl_{\vec x}\vec
h\right|^2\right)=\\
-\sum_{n=0}^{3}\sum_{k=0}^{3}\sum_{m=0}^{3}\sum_{p=0}^{3}\frac{c^2}{4}g^{mn}g^{pk}\left(\frac{\partial
s_p}{\partial x^m}-\frac{\partial s_m}{\partial
x^p}\right)\left(\frac{\partial s_k}{\partial x^n}-\frac{\partial
s_n}{\partial x^k}\right).
\end{multline}
Then, similarly as it was done in
\er{vhfffngghkjgghDDmmioiuiujhhjhjint}, by
\er{vhfffngghkjgghPPNggjgjjkgjoiuijouuiyiyyyuyuint},
\er{vhfffngghkjgghPPNggjgjjkgjoiuipopiioioouuouint} and
\er{hoyuiouigyfghgjh3478zzrrZZjhjhugyuyughggjhjhjyuuygtyffjjjkjljkjjljokjljjkjkopopkklklkjjkjkjjkkokkjoipio;l;l;ll;lkklkjjkkklkkkkkjojjhint}
we rewrite the Lagrangian density in
\er{vhfffngghkjgghDDnw22mm3488int} or
\er{vhfffngghkjgghDDnw22jkjkjkjk22gughhg22mm3488int} in the
cartesian coordinate system as:
\begin{multline}\label{vhfffngghkjgghDDmmioiuiujhhjhjjhyryrrint}
\frac{1}{8\pi}\left|-\nabla_{\vec
x}\Psi-\frac{1}{c}\frac{\partial\vec A}{\partial t}+\frac{1}{c}\vec
v\times curl_{\vec x}\vec A\right|^2-\frac{1}{8\pi}\left|curl_{\vec
x}\vec A\right|^2-\left(\rho\Psi-\frac{1}{c}\vec A\cdot\vec j\right)
+\mu g\left(\frac{1}{2}\left|\vec u-\vec v\right|^2\right)
\\
-\frac{c^2}{8\pi G}\left|-\nabla_{\vec x}\Phi_0
-\frac{1}{c}\frac{\partial\vec h}{\partial t}
+\frac{1}{c}\vec v\times curl_{\vec x}\vec h-\frac{1}{c}\nabla_{\vec
x}\left(\vec h\cdot\vec v\right)\right|^2+\frac{c^2}{8\pi
G}\left|curl_{\vec x}\vec h\right|^2\\+\frac{c^2\beta}{4\pi
G}\left(\frac{1}{4}\left|d_{\vec x}\vec v+\left\{d_{\vec x}\vec
v\right\}^T\right|^2-\left|\Div_{\vec x}\vec v\right|^2\right)
=L_{ge}=
\\
\frac{1}{4\pi}\left(-\sum_{n=0}^{3}\sum_{k=0}^{3}\sum_{m=0}^{3}\sum_{p=0}^{3}\frac{1}{4}g^{mn}g^{pk}\left(\frac{\partial
A_p}{\partial x^m}-\frac{\partial A_m}{\partial
x^p}\right)\left(\frac{\partial A_k}{\partial x^n}-\frac{\partial
A_n}{\partial x^k}\right)-\sum_{k=0}^{3}4\pi j^k A_k\right)\\
+\frac{\hat \mu}{\sqrt{\left|\text{det}\,G\right|}}
\left(\sum\limits_{m=0}^{3}\sum\limits_{k=0}^{3}g_{mk}\,\hat
u^m_{x^0}\,u^k_{t}\right)\left(\sum\limits_{m=0}^{3}\sum\limits_{k=0}^{3}g_{mk}\,
u^m_{t}\,u^k_{t}\right)^{-1}g\left(\frac{c^2}{2}-\sum\limits_{m=0}^{3}\sum\limits_{k=0}^{3}\frac{c^2}{2}g_{mk}\,
u^m_{t}\,u^k_{t}\right)
\\
+\frac{c^4}{4\pi
G}\left(\sum_{n=0}^{3}\sum_{k=0}^{3}\sum_{m=0}^{3}\sum_{p=0}^{3}\frac{1}{4}g^{mn}g^{pk}\left(\frac{\partial
s_p}{\partial x^m}-\frac{\partial s_m}{\partial
x^p}\right)\left(\frac{\partial s_k}{\partial x^n}-\frac{\partial
s_n}{\partial x^k}\right)\right)\\
-\frac{c^4\beta}{16\pi
G}\left(\left(\sum_{i=0}^{3}\sum_{j=0}^{3}g^{ij}\left(\delta_j
v_i+\delta_i
v_j\right)\right)^2-\sum_{k=0}^{3}\sum_{j=0}^{3}\sum_{m=0}^{3}\sum_{n=0}^{3}\left(\delta_j
v_k+\delta_k v_j\right)g^{km}g^{jn}\left(\delta_m v_n+\delta_n
v_m\right)\right) ,
\end{multline}
where the right hand side is written in the covariant form and the
second equality is valid in every curvilinear coordinate system with
$\frac{\partial t}{\partial x^0}>0$, $(j^0,j^1,j^2,j^3)$ is the
four-vector of the current, $\hat\mu$, $u^j_{t}$ and $u^j_{x^0}$ are
the same as in
\er{btfffjhgjghghijhhjhhhjjkjkkjjhhhjkjjkjhjjkkjklkklijhjhjkmhggmhjhjhkkkint},
and $\vec h$ and $\Phi_0$ are given by
\begin{equation}\label{btfffygtgyggyDDnw22jkhk22mm3488jkjljljujhkhint}
\vec h=\vec v-\vec
k\quad\text{and}\quad\Phi_0=-\frac{1}{2c}\left|\vec v-\vec
k\right|^2.
\end{equation}
Moreover in the particular case of the relativistic-like choice
$g(s):=-c^2\sqrt{1-\frac{2s}{c^2}}$, by
\er{btfffjhgjghghijhhjhhhjjkjkkjjhhhjkjjkjhjjkkjklkklijhjhjkmhggmint}
we can write an alternative to
\er{vhfffngghkjgghDDmmioiuiujhhjhjjhyryrrint} as:
\begin{multline}\label{vhfffngghkjgghDDmmioiuiujhhjhjjhyryrrjkjhhhint}
\frac{1}{8\pi}\left|-\nabla_{\vec
x}\Psi-\frac{1}{c}\frac{\partial\vec A}{\partial t}+\frac{1}{c}\vec
v\times curl_{\vec x}\vec A\right|^2-\frac{1}{8\pi}\left|curl_{\vec
x}\vec A\right|^2-\left(\rho\Psi-\frac{1}{c}\vec A\cdot\vec j\right)
+\mu g\left(\frac{1}{2}\left|\vec u-\vec v\right|^2\right)
\\
-\frac{c^2}{8\pi G}\left|-\nabla_{\vec x}\Phi_0
-\frac{1}{c}\frac{\partial\vec h}{\partial t}
+\frac{1}{c}\vec v\times curl_{\vec x}\vec h-\frac{1}{c}\nabla_{\vec
x}\left(\vec h\cdot\vec v\right)\right|^2+\frac{c^2}{8\pi
G}\left|curl_{\vec x}\vec h\right|^2\\+\frac{c^2\beta}{4\pi
G}\left(\frac{1}{4}\left|d_{\vec x}\vec v+\left\{d_{\vec x}\vec
v\right\}^T\right|^2-\left|\Div_{\vec x}\vec v\right|^2\right)
=L_{ge}=
\\
\frac{1}{4\pi}\left(-\sum_{n=0}^{3}\sum_{k=0}^{3}\sum_{m=0}^{3}\sum_{p=0}^{3}\frac{1}{4}g^{mn}g^{pk}\left(\frac{\partial
A_p}{\partial x^m}-\frac{\partial A_m}{\partial
x^p}\right)\left(\frac{\partial A_k}{\partial x^n}-\frac{\partial
A_n}{\partial x^k}\right)-\sum_{k=0}^{3}4\pi j^k A_k\right)\\
-\frac{c^2\hat \mu}{\sqrt{\left|\text{det}\,G\right|}}
\sqrt{\left(\sum\limits_{m=0}^{3}\sum\limits_{k=0}^{3}g_{mk}\,\hat
u^m_{x^0}\,\hat u^k_{x^0}\right)}
\\
+\frac{c^4}{4\pi
G}\left(\sum_{n=0}^{3}\sum_{k=0}^{3}\sum_{m=0}^{3}\sum_{p=0}^{3}\frac{1}{4}g^{mn}g^{pk}\left(\frac{\partial
s_p}{\partial x^m}-\frac{\partial s_m}{\partial
x^p}\right)\left(\frac{\partial s_k}{\partial x^n}-\frac{\partial
s_n}{\partial x^k}\right)\right)\\
-\frac{c^4\beta}{16\pi
G}\left(\left(\sum_{i=0}^{3}\sum_{j=0}^{3}g^{ij}\left(\delta_j
v_i+\delta_i
v_j\right)\right)^2-\sum_{k=0}^{3}\sum_{j=0}^{3}\sum_{m=0}^{3}\sum_{n=0}^{3}\left(\delta_j
v_k+\delta_k v_j\right)g^{km}g^{jn}\left(\delta_m v_n+\delta_n
v_m\right)\right) ,
\end{multline}
where the right hand side is written in the covariant form and the
second equality is valid in every curvilinear coordinate system with
$\frac{\partial t}{\partial x^0}>0$. On the other hand, in the case
of fully non-relativistic Lagrangian, where
$g(s)=\left(s-\frac{c^2}{2}\right)$, we can write an alternative to
\er{vhfffngghkjgghDDmmioiuiujhhjhjjhyryrrint} as:
\begin{multline}\label{vhfffngghkjgghDDmmioiuiujhhjhjjhyryrrjkjhhhjklkjhhhint}
\frac{1}{8\pi}\left|-\nabla_{\vec
x}\Psi-\frac{1}{c}\frac{\partial\vec A}{\partial t}+\frac{1}{c}\vec
v\times curl_{\vec x}\vec A\right|^2-\frac{1}{8\pi}\left|curl_{\vec
x}\vec A\right|^2-\left(\rho\Psi-\frac{1}{c}\vec A\cdot\vec j\right)
+\mu g\left(\frac{1}{2}\left|\vec u-\vec v\right|^2\right)
\\
-\frac{c^2}{8\pi G}\left|-\nabla_{\vec x}\Phi_0
-\frac{1}{c}\frac{\partial\vec h}{\partial t}
+\frac{1}{c}\vec v\times curl_{\vec x}\vec h-\frac{1}{c}\nabla_{\vec
x}\left(\vec h\cdot\vec v\right)\right|^2+\frac{c^2}{8\pi
G}\left|curl_{\vec x}\vec h\right|^2\\+\frac{c^2\beta}{4\pi
G}\left(\frac{1}{4}\left|d_{\vec x}\vec v+\left\{d_{\vec x}\vec
v\right\}^T\right|^2-\left|\Div_{\vec x}\vec v\right|^2\right)
=L_{ge}=
\\
\frac{1}{4\pi}\left(-\sum_{n=0}^{3}\sum_{k=0}^{3}\sum_{m=0}^{3}\sum_{p=0}^{3}\frac{1}{4}g^{mn}g^{pk}\left(\frac{\partial
A_p}{\partial x^m}-\frac{\partial A_m}{\partial
x^p}\right)\left(\frac{\partial A_k}{\partial x^n}-\frac{\partial
A_n}{\partial x^k}\right)-\sum_{k=0}^{3}4\pi j^k A_k\right)\\
-\frac{c^2\hat \mu}{\sqrt{\left|\text{det}\,G\right|}}
\left(\sum\limits_{m=0}^{3}\sum\limits_{k=0}^{3}g_{mk}\,\hat
u^m_{x^0}\,u^k_{t}\right)
\\
+\frac{c^4}{4\pi
G}\left(\sum_{n=0}^{3}\sum_{k=0}^{3}\sum_{m=0}^{3}\sum_{p=0}^{3}\frac{1}{4}g^{mn}g^{pk}\left(\frac{\partial
s_p}{\partial x^m}-\frac{\partial s_m}{\partial
x^p}\right)\left(\frac{\partial s_k}{\partial x^n}-\frac{\partial
s_n}{\partial x^k}\right)\right)\\
-\frac{c^4\beta}{16\pi
G}\left(\left(\sum_{i=0}^{3}\sum_{j=0}^{3}g^{ij}\left(\delta_j
v_i+\delta_i
v_j\right)\right)^2-\sum_{k=0}^{3}\sum_{j=0}^{3}\sum_{m=0}^{3}\sum_{n=0}^{3}\left(\delta_j
v_k+\delta_k v_j\right)g^{km}g^{jn}\left(\delta_m v_n+\delta_n
v_m\right)\right) ,
\end{multline}
where again the right hand side is written in the covariant form and
the second equality is valid in every curvilinear coordinate system
with $\frac{\partial t}{\partial x^0}>0$.

Note that the four-vector of the gravitational potential
$(v^0,v^1,v^2,v^3)$ and the four-covector of the genuine gravity
$(s_0,s_1,s_2,s_3)$ are \underline{not} independent arguments of
$L_{ge}$. Moreover, in the general non-cartesian or curvilinear
coordinate system, knowing $(v^0,v^1,v^2,v^3)$ and thus also knowing
$\{g_{ij}\}_{0\leq i,j\leq 3}$, as before in
\er{huohuioy89gjjhjffffff3478zzrrZZZhjhhjhhjjhhffGGhjjhuiiuihhjkkoioirrkhrrjggggjrrjiokuukuppkpklhjjhj88khhjhint},
we can find $(s_0,s_1,s_2,s_3)$ by:
\begin{equation}\label{huohuioy89gjjhjffffff3478zzrrZZZhjhhjhhjjhhffGGhjjhuiiuihhjkkoioirrkhrrjggggjrrjiokuukuppkpklhjjhj88khhjhjkojjint}
s_j=\frac{1}{2}\left(\sum_{m=0}^{3} g_{jm}k^m-\sum_{m=0}^{3}
J_{jm}v^m\right)\quad \forall j=0,1,2,3.
\end{equation}
where $(k^0,k^1,k^2,k^3)$ is the four-vector of the potential of
inertia and $\{J_{ij}\}_{0\leq i,j\leq 3}$ is the covariant
kinematic pseudo-metric tensor of inertia, that are prescribed in
every cartesian, non-cartesian or curvilinear coordinate system. On
the other hand, knowing $(s_0,s_1,s_2,s_3)$, we can find
$\{g_{ij}\}_{0\leq i,j\leq 3}$ and $(v^0,v^1,v^2,v^3)$ by the
following covariant identities
\begin{equation}\label{hoyuiouigyfghgjh3478zzrrZZffGGjkkjojj88iihhkhk88jklhkhjkhk8899ouiuuiint}
{g}_{ij}=J_{ij}+\left(s_i\left(c\,\frac{\partial t}{\partial
x^j}\right)+\left(c\,\frac{\partial t}{\partial
x^i}\right)s_j\right)\quad\quad\forall\,i,j=0,1,2,3
\end{equation}
(see
\er{hoyuiouigyfghgjh3478zzrrZZffGGjkkjojj88iihhkhk88jklhkhjkhk8899int}),
and
\begin{equation}\label{hoyuiouigyfghgjh3478zzrrZZffGGjkkjojj88jklkhhk88khjhgh8899kpkjklint}
v^j-k^j=\sum_{m=0}^{3}{\Theta}^{jm}s_m=\sum_{m=0}^{3}\left(k^jk^m-J^{jm}\right)s_m\quad\quad\forall\,j=0,1,2,3
\end{equation}
(see
\er{hoyuiouigyfghgjh3478zzrrZZffGGjkkjojj88jklkhhk88khjhgh8899int}).

Once we wrote the Lagrangian density
$L_{ge}:=L_{ge1}\left(v^k,A^k,x^k\right)_{k=0,\ldots
4}=L_{ge2}\left(s^k,A^k,x^k\right)_{k=0,\ldots 4}$ as a covariant
scalar, under the changes of curvilinear coordinate systems, we can
write a covariant Lagrangian as:
\begin{multline}\label{uuuuuuuujlkjjklkjkint}
J_{ge1}(v^k,A^k)=J_{ge1}(s^k,A^k):=\int_{(x^0,x^1,x^2,x^3)}L_{ge1}\left(v^k,A^k,x^k\right)\,\sqrt{\left|\text{det}\,G\right|}\,dx^0dx^1dx^2dx^3
\\=\int_{(x^0,x^1,x^2,x^3)}L_{ge2}\left(s^k,A^k,x^k\right)\,\sqrt{\left|\text{det}\,G\right|}\,dx^0dx^1dx^2dx^3.
\end{multline}
Although we need a term $\sqrt{\left|\text{det}\,G\right|}$ for the
covariance of the Lagrangian in curvilinear coordinate systems, in
the cartesian coordinate systems we always have
$\sqrt{\left|\text{det}\,G\right|}=1$.

Finally, note again, that the four-vector of the gravitational
potential, as an independent argument, is restricted by
\er{fgjfjhfgfgfgfhihhjhhjjhhjhjjhjhhjhjkjhint}:
\begin{equation}\label{fgjfjhfgfgfgfhihhjhhjjhhjhjjhjhhjhjkjhojjjjjlljjint}
\sum_{k=0}^{3}c\frac{\partial t}{\partial x^k}v^k
=1.
\end{equation}
So the four-vector $(v^0,v^1,v^2,v^3)$ actually contains three
independent scalar functions similarly as in cartesian coordinate
systems where we have $v^0=1$. On the other hand if we consider
$(s_0,s_1,s_2,s_3)$ instead of $(v^0,v^1,v^2,v^3)$ , as an
independent argument, then by
\er{fgjfjhgghhgjghjhjkkkkgjghghuiiiuujhjhljhgugii88int} and
\er{hoyuiouigyfghgjh3478zzrrZZffGGjkkjojj88iihhkhk88jklhkhjkhk8899ouiuuiint}
it is restricted by the following identity
\begin{equation}\label{fgjfjhgghhgjghjhjkkkkgjghghuiiiuujhjhljhgugii88jklkljjkk;kklint}
\text{det}\,\left(\left\{J_{ij}+s_i\left(c\,\frac{\partial
t}{\partial x^j}\right)+\left(c\,\frac{\partial t}{\partial
x^i}\right)s_j\right\}_{0\leq i,j\leq
3}\right)=\text{det}\,\left(\{J_{ij}\}_{0\leq i,j\leq 3}\right),
\end{equation}
which is valid in every cartesian, non-cartesian or curvilinear
coordinate system. So the four-covector $(s_0,s_1,s_2,s_3)$ also
contains three independent scalar functions.

\subsection{Relativistic-like Dirac equation} As in
\er{btfffygtgyggyijhhkkSYSPNuhuygyygyggyuyyuyuykuhghgjjojiyyyuyuuyyint}
consider the relativistic-like Lagrangian of the motion of the
particle with mass $m$ and charge $\sigma$ in the outer
gravitational and electromagnetic fields and additional field with
potential $V(\vec x,t)$:
\begin{multline}\label{btfffygtgyggyijhhkkSYSPNuhuygyygyggyuyyuyuykuhghgjjojiyyyuyuuyyKKint}
J_{rl}(\vec r)=\int_0^T L_0\left(\frac{d\vec r}{dt},\vec
r,t\right)dt:=\\ \int_0^T\left\{
-mc^2\sqrt{1-\frac{1}{c^2}\left|\frac{d\vec r}{dt}-\vec v(\vec
r,t)\right|^2}-\sigma\left(\Psi(\vec r,t)-\frac{1}{c}\vec A(\vec
r,t)\cdot\frac{d\vec r}{dt}\right)+V\left(\vec r,t\right)\right\}dt.
\end{multline}
Next define the generalized momentum of the particle by
\begin{equation}\label{guytyurtydftyiujhSYShmhmKKint}
\vec P:=\nabla_{\vec r'}L_0\left(\vec r',\vec
r,t\right)=m\left(1-\frac{1}{c^2}\left|\frac{d\vec r}{dt}-\vec
v(\vec r,t)\right|^2\right)^{-\frac{1}{2}}\left(\frac{d\vec
r}{dt}-\vec v(\vec r,t)\right)+\frac{\sigma}{c}\vec A(\vec r,t).
\end{equation}
Then
\begin{equation}\label{guytyurtydftyiujhioyhuyhSYShmhmuiiuiuKKint}
\frac{d\vec r}{dt}=\left(1+\frac{1}{c^2}\left|\frac{1}{m}\vec
P-\frac{\sigma}{mc}\vec A(\vec
r,t)\right|^2\right)^{-\frac{1}{2}}\left(\frac{1}{m}\vec
P-\frac{\sigma}{mc}\vec A(\vec r,t)\right)+\vec v(\vec r,t).
\end{equation}
Thus, if we consider a Hamiltonian
\begin{equation}\label{vhfffngghkjgghfjjhyjjfgSYShmhmKKint}
H_0\left(\vec P,\vec r,t\right):=\vec P\cdot\frac{d\vec
r}{dt}-L_0\left(\frac{d\vec r}{dt},\vec r,t\right),
\end{equation}
then we deduce, that the relativistic-like Hamiltonian for a
macro-particles has the form:
\begin{multline}\label{vhfffngghkjgghfjjghghghSYShmyuuiiuuhmhmiopoopKKint}
H_0\left(\vec P,\vec r,t\right)=
mc^2\left(1+\frac{1}{c^2}\left|\frac{1}{m}\vec
P-\frac{\sigma}{mc}\vec A(\vec r,t)\right|^2\right)^{\frac{1}{2}}
+\sigma\left(\Psi(\vec r,t)-\frac{1}{c}\vec v(\vec r,t)\cdot\vec
A(\vec r,t)\right)\\-V\left(\vec r,t\right)+\vec v(\vec
r,t)\cdot\vec P
\end{multline}
(see section \ref{fgffggfggfhhjh} for the details). In particular,
if similarly to
\er{fgjfjhgghhgjghjhjijhojihjhjjijhjjjljljpkhkhjgjgint} we define
the four-dimensional generalized momentum $(P_0,P_1,P_2,P_3)$ as:
\begin{equation}\label{fgjfjhgghhgjghjhjijhojihjhjjijhjjjljljpkhkhjgjgjjint}
(P_0,P_1,P_2,P_3):=\left(\frac{1}{c}H_0,-\vec
P\right)\quad\text{where}\quad
P_0=\frac{1}{c}H_0\;\;\text{and}\;\;(P_1,P_2,P_3)=-\vec P,
\end{equation}
Then, since by \er{guytyurtydftyiujhSYShmhmKKint} and
\er{vhfffngghkjgghfjjghghghSYShmyuuiiuuhmhmiopoopKKint}, under the
change of non-inertial cartesian coordinate system $H_0$ and $\vec
P$ transform as
\begin{equation}\label{vhfffngghhjghhgPPNghghghutghfflklhjkjhjhjjgjkghhjhhjhjkhjghlkklljjint}
\begin{cases}
H_0'=
H_0+\left(\frac{dA}{dt}(t)\cdot\vec x+\frac{d\vec
z}{dt}(t)\right)\cdot\left(A(t)\cdot\vec P\right)
\\
\vec P'=A(t)\cdot \vec P,
\end{cases}
\end{equation}
by comparing
\er{vhfffngghhjghhgPPNghghghutghfflklhjkjhjhjjgjkghhjhhjhjkhjghlkklljjint}
with \er{fgjfjhgghhgjghjhjijhojpiiihjhjuioujint} we deduce that the
four-dimensional momentum $(P_0,P_1,P_2,P_3)$ is a
four-\underline{covector} on the group $\mathcal{S}_0$ that is the
group of changes of cartesian non-inertial coordinate systems.

See subsection \ref{hgggggjjhhk} for the Relativistic-like
Lagrangian and Hamiltonian of the motion of the system of $n$
particles. See also subsection \ref{gghfhfhfhfh11} for the
Liouville's equation for a system of $n$ relativistic-like classical
particles.

Next consider the motion of a spin-half quantum relativistic-like
micro-particle with inertial mass $m$ and the charge $\sigma$ in the
outer gravitational and electromagnetical field with characteristics
$\vec v(\vec x,t)$, $\vec A(\vec x,t)$ and $\Psi(\vec x,t)$ and
additional conservative field with potential $V(\vec y,t)$. The
evolution equation for this particle is
\begin{equation}\label{vhfffngghkjgghfjjghghghhjghjgghkghggkghghjghSYSPNnnkkJJint}
i\hbar\frac{\partial\psi}{\partial t}=\hat H_0\cdot\psi,
\end{equation}
where $\psi(\vec x,t)=\left(\psi_1(\vec x,t),\psi_2(\vec
x,t)\right)\in\mathbb{C}^2\times \mathbb{C}^2$ is a four-component
wave function and $\hat H_0$ is the Hamiltonian operator. Since the
relativistic-like Hamiltonian for a macro-particles has the form
\er{vhfffngghkjgghfjjghghghSYShmyuuiiuuhmhmiopoopKKint},
%
%
%
analogously to the usual Dirac Hamiltonian operator, we built the
Hermitian Hamiltonian operator as $\hat H_0\cdot\psi=\left(\hat
H_1\cdot\psi,\hat H_2\cdot\psi\right)$, where
\begin{multline}\label{vhfffngghkjgghfjjghghghSYShmyuuiiuuhmhmiopoopnniukjhjkkkJJint}
\hat H_1\cdot\psi= mc^2\psi_1-c\,\vec S\cdot\left(i\hbar\nabla_{\vec
x}\psi_2+\frac{\sigma}{c}\vec A(\vec x,t)\psi_2\right)
%
%
%
+\sigma\left(\Psi(\vec x,t)-\frac{1}{c}\vec v(\vec x,t)\cdot\vec
A(\vec x_j,t)\right)\psi_1\\-V\left(\vec
x,t\right)\psi_1-\frac{i\hbar}{2}div_{\vec x}\left\{\psi_1\vec
v(\vec x,t)\right\}-\frac{i\hbar}{2}\nabla_{\vec x}\psi_1\cdot\vec
v(\vec x,t)+\frac{\hbar}{4}\vec S\cdot\left(curl_{\vec x}\vec v(\vec
x,t)\,\psi_1\right)\\-\frac{(g-1)\sigma\hbar}{2mc}\vec
S\cdot\left(curl_{\vec x}\vec A(\vec x,t)\,\psi_1\right),
\end{multline}
and
\begin{multline}\label{vhfffngghkjgghfjjghghghSYShmyuuiiuuhmhmiopoopnniukjhjkhhjkkJJint}
\hat H_2\cdot\psi= -mc^2\psi_2-c\,\vec
S\cdot\left(i\hbar\nabla_{\vec x}\psi_1+\frac{\sigma}{c}\vec A(\vec
x,t)\psi_1\right)
%
%
%
+\sigma\left(\Psi(\vec x,t)-\frac{1}{c}\vec v(\vec x,t)\cdot\vec
A(\vec x,t)\right)\psi_2\\-V\left(\vec
x,t\right)\psi_2-\frac{i\hbar}{2}div_{\vec x}\left\{\psi_2\vec
v(\vec x,t)\right\}-\frac{i\hbar}{2}\nabla_{\vec x}\psi_2\cdot\vec
v(\vec x,t)+\frac{\hbar}{4}\vec S\cdot\left(curl_{\vec x}\vec v(\vec
x,t)\,\psi_2\right)\\-\frac{(g-1)\sigma\hbar}{2mc}\vec
S\cdot\left(curl_{\vec x}\vec A(\vec x,t)\,\psi_2\right),
\end{multline}
where $\vec S:=(S_1,S_2,S_3)$ and
$$S_1=\left(\begin{matrix}0&1\\1&0\end{matrix}\right),\quad
S_2=\left(\begin{matrix}0&-i\\i&0\end{matrix}\right)\quad
S_3=\left(\begin{matrix}1&0\\0&-1\end{matrix}\right)$$ are Pauli
matrices and $g$ is a constant that depends on the type of the
particle (for electron we have $g=1$). As before for the
Schr\"{o}dinger-Pauli equation, we added an additional term to the
Hamiltonian, namely $\frac{\hbar}{4}\vec S\cdot\left(curl_{\vec
x}\vec v(\vec x,t)\,\psi\right)$. Although this term vanishes in
inertial coordinate systems, it provides however invariance of our
Dirac-type equation, under the change of non-inertial coordinate
systems as we will see below. Thus, we have the following two
evolution equations that we call together Dirac system of equations:
\begin{multline}\label{vhfffngghkjgghfjjghghghSYShmyuuiiuuhmhmiopoopnniukjhjkk;l;lkkkJJint}
i\hbar\frac{\partial\psi_1}{\partial t}=mc^2\psi_1-c\,\vec
S\cdot\left(i\hbar\nabla_{\vec x}\psi_2+\frac{\sigma}{c}\vec A(\vec
x,t)\psi_2\right)
%
%
%
+\sigma\left(\Psi(\vec x,t)-\frac{1}{c}\vec v(\vec x,t)\cdot\vec
A(\vec x_j,t)\right)\psi_1\\-V\left(\vec
x,t\right)\psi_1-\frac{i\hbar}{2}div_{\vec x}\left\{\psi_1\vec
v(\vec x,t)\right\}-\frac{i\hbar}{2}\nabla_{\vec x}\psi_1\cdot\vec
v(\vec x,t)+\frac{\hbar}{4}\vec S\cdot\left(curl_{\vec x}\vec v(\vec
x,t)\,\psi_1\right)\\-\frac{(g-1)\sigma\hbar}{2mc}\vec
S\cdot\left(curl_{\vec x}\vec A(\vec x,t)\,\psi_1\right),
\end{multline}
and
\begin{multline}\label{vhfffngghkjgghfjjghghghSYShmyuuiiuuhmhmiopoopnniukjhjkhhjl;kkJJint}
i\hbar\frac{\partial\psi_2}{\partial t}=-mc^2\psi_2-c\,\vec
S\cdot\left(i\hbar\nabla_{\vec x}\psi_1+\frac{\sigma}{c}\vec A(\vec
x,t)\psi_1\right)
%
%
%
+\sigma\left(\Psi(\vec x,t)-\frac{1}{c}\vec v(\vec x,t)\cdot\vec
A(\vec x,t)\right)\psi_2\\-V\left(\vec
x,t\right)\psi_2-\frac{i\hbar}{2}div_{\vec x}\left\{\psi_2\vec
v(\vec x,t)\right\}-\frac{i\hbar}{2}\nabla_{\vec x}\psi_2\cdot\vec
v(\vec x,t)+\frac{\hbar}{4}\vec S\cdot\left(curl_{\vec x}\vec v(\vec
x,t)\,\psi_2\right)\\-\frac{(g-1)\sigma\hbar}{2mc}\vec
S\cdot\left(curl_{\vec x}\vec A(\vec x,t)\,\psi_2\right).
\end{multline}
Then we can rewrite Dirac equations as:
\begin{multline}\label{vhfffngghkjgghfjjghghghSYShmyuuiiuuhmhmiopoopnniukjhjkk;l;lkkkJJjjkjkint}
i\hbar\left(\frac{\partial\psi_1}{\partial t}+\frac{1}{2}div_{\vec
x}\left\{\psi_1\vec v(\vec x,t)\right\}+\frac{1}{2}\nabla_{\vec
x}\psi_1\cdot\vec v(\vec x,t)\right)=mc^2\psi_1-c\,\vec
S\cdot\left(i\hbar\nabla_{\vec x}\psi_2+\frac{\sigma}{c}\vec A(\vec
x,t)\psi_2\right)
%
%
%
\\+\sigma\left(\Psi(\vec x,t)-\frac{1}{c}\vec v(\vec x,t)\cdot\vec
A(\vec x,t)\right)\psi_1-V\left(\vec
x,t\right)\psi_1+\frac{\hbar}{4}\vec S\cdot\left(curl_{\vec x}\vec
v(\vec x,t)\,\psi_1\right)\\-\frac{(g-1)\sigma\hbar}{2mc}\vec
S\cdot\left(curl_{\vec x}\vec A(\vec x,t)\,\psi_1\right),
\end{multline}
and
\begin{multline}\label{vhfffngghkjgghfjjghghghSYShmyuuiiuuhmhmiopoopnniukjhjkhhjl;kkJJkllkklint}
i\hbar\left(\frac{\partial\psi_2}{\partial t}+\frac{1}{2}div_{\vec
x}\left\{\psi_2\vec v(\vec x,t)\right\}+\frac{1}{2}\nabla_{\vec
x}\psi_2\cdot\vec v(\vec x,t)\right)=-mc^2\psi_2-c\,\vec
S\cdot\left(i\hbar\nabla_{\vec x}\psi_1+\frac{\sigma}{c}\vec A(\vec
x,t)\psi_1\right)
%
%
%
\\+\sigma\left(\Psi(\vec x,t)-\frac{1}{c}\vec v(\vec x,t)\cdot\vec
A(\vec x,t)\right)\psi_2-V\left(\vec
x,t\right)\psi_2+\frac{\hbar}{4}\vec S\cdot\left(curl_{\vec x}\vec
v(\vec x,t)\,\psi_2\right)\\-\frac{(g-1)\sigma\hbar}{2mc}\vec
S\cdot\left(curl_{\vec x}\vec A(\vec x,t)\,\psi_2\right).
\end{multline}
Then similarly to the proof of Theorem \ref{gjghghgghgintintrrZZint}
about the invariance of Shr\"{o}dinger-Pauli equation we can prove
the following Theorem for Dirac equations:
\begin{theorem}\label{gjghghgghgintintrrZZjkhint}
Consider that the change of some cartesian coordinate system $(*)$
to another cartesian coordinate system $(**)$ is given by
\begin{equation}\label{noninchgravortbstrjgghguittu2intrrrZZjkjjkjkjkint}
\begin{cases}
\vec x'=A(t)\cdot\vec x+\vec z(t),\\
t'=t,
\end{cases}
\end{equation}
where $A(t)\in SO(3)$ is a rotation. Next, assume that in the
coordinate system $(**)$ we observe a validity of the Dirac
equations of the form:
\begin{multline}\label{vhfffngghkjgghfjjghghghSYShmyuuiiuuhmhmiopoopnniukjhjkk;l;lkkkJJjjkjk;lk;klint}
i\hbar\left(\frac{\partial\psi'_1}{\partial t'}+\frac{1}{2}div_{\vec
x'}\left\{\psi'_1\vec v'(\vec x',t')\right\}+\frac{1}{2}\nabla_{\vec
x'}\psi'_1\cdot\vec v'(\vec x',t')\right)=m'c^2\psi'_1-c\,\vec
S\cdot\left(i\hbar\nabla_{\vec x'}\psi'_2+\frac{\sigma'}{c}\vec
A'(\vec x',t)\psi'_2\right)
%
%
%
\\+\sigma'\left(\Psi'(\vec x',t')-\frac{1}{c}\vec v'(\vec x',t')\cdot\vec
A'(\vec x',t')\right)\psi'_1-V'\left(\vec
x',t'\right)\psi'_1+\frac{\hbar}{4}\vec S\cdot\left(curl_{\vec
x'}\vec v'(\vec
x',t')\,\psi'_1\right)\\-\frac{(g'-1)\sigma'\hbar}{2m'c}\vec
S\cdot\left(curl_{\vec x'}\vec A'(\vec x',t')\,\psi'_1\right),
\end{multline}
and
\begin{multline}\label{vhfffngghkjgghfjjghghghSYShmyuuiiuuhmhmiopoopnniukjhjkhhjl;kkJJkllkklkkint}
i\hbar\left(\frac{\partial\psi'_2}{\partial t'}+\frac{1}{2}div_{\vec
x'}\left\{\psi'_2\vec v'(\vec x',t')\right\}+\frac{1}{2}\nabla_{\vec
x'}\psi'_2\cdot\vec v'(\vec x',t')\right)=\\-m'c^2\psi'_2-c\,\vec
S\cdot\left(i\hbar\nabla_{\vec x'}\psi'_1+\frac{\sigma'}{c}\vec
A'(\vec x',t')\psi'_1\right)
%
%
%
+\sigma'\left(\Psi'(\vec x',t')-\frac{1}{c}\vec v'(\vec
x',t')\cdot\vec A'(\vec x',t')\right)\psi'_2\\-V'\left(\vec
x',t'\right)\psi'_2+\frac{\hbar}{4}\vec S\cdot\left(curl_{\vec
x'}\vec v'(\vec
x',t')\,\psi'_2\right)\\-\frac{(g'-1)\sigma'\hbar}{2m'c}\vec
S\cdot\left(curl_{\vec x'}\vec A'(\vec x',t')\,\psi'_2\right),
\end{multline}
where $\psi=(\psi_1,\psi_2)\in\mathbb{C}^2\times \mathbb{C}^2$ is a
four-component wave function. Then in the coordinate system $(*)$ we
have the validity of Dirac equations of the same as
\er{vhfffngghkjgghfjjghghghSYShmyuuiiuuhmhmiopoopnniukjhjkk;l;lkkkJJjjkjk;lk;klint}
and
\er{vhfffngghkjgghfjjghghghSYShmyuuiiuuhmhmiopoopnniukjhjkhhjl;kkJJkllkklkkint}
form:
\begin{multline}\label{vhfffngghkjgghfjjghghghSYShmyuuiiuuhmhmiopoopnniukjhjkk;l;lkkkJJjjkjkjkjkint}
i\hbar\left(\frac{\partial\psi_1}{\partial t}+\frac{1}{2}div_{\vec
x}\left\{\psi_1\vec v(\vec x,t)\right\}+\frac{1}{2}\nabla_{\vec
x}\psi_1\cdot\vec v(\vec x,t)\right)=mc^2\psi_1-c\,\vec
S\cdot\left(i\hbar\nabla_{\vec x}\psi_2+\frac{\sigma}{c}\vec A(\vec
x,t)\psi_2\right)
%
%
%
\\+\sigma\left(\Psi(\vec x,t)-\frac{1}{c}\vec v(\vec x,t)\cdot\vec
A(\vec x,t)\right)\psi_1-V\left(\vec
x,t\right)\psi_1+\frac{\hbar}{4}\vec S\cdot\left(curl_{\vec x}\vec
v(\vec x,t)\,\psi_1\right)\\-\frac{(g-1)\sigma\hbar}{2mc}\vec
S\cdot\left(curl_{\vec x}\vec A(\vec x,t)\,\psi_1\right),
\end{multline}
and
\begin{multline}\label{vhfffngghkjgghfjjghghghSYShmyuuiiuuhmhmiopoopnniukjhjkhhjl;kkJJkllkkljjjint}
i\hbar\left(\frac{\partial\psi_2}{\partial t}+\frac{1}{2}div_{\vec
x}\left\{\psi_2\vec v(\vec x,t)\right\}+\frac{1}{2}\nabla_{\vec
x}\psi_2\cdot\vec v(\vec x,t)\right)=-mc^2\psi_2-c\,\vec
S\cdot\left(i\hbar\nabla_{\vec x}\psi_1+\frac{\sigma}{c}\vec A(\vec
x,t)\psi_1\right)
%
%
%
\\+\sigma\left(\Psi(\vec x,t)-\frac{1}{c}\vec v(\vec x,t)\cdot\vec
A(\vec x,t)\right)\psi_2-V\left(\vec
x,t\right)\psi_2+\frac{\hbar}{4}\vec S\cdot\left(curl_{\vec x}\vec
v(\vec x,t)\,\psi_2\right)\\-\frac{(g-1)\sigma\hbar}{2mc}\vec
S\cdot\left(curl_{\vec x}\vec A(\vec x,t)\,\psi_2\right),
\end{multline}
provided that
\begin{equation}\label{yuythfgfyftydtydtydtyddyyyhhddhhhredPPN111hgghjgintintrrZZkjhhint}
\begin{cases}
V'=V,\\
\sigma'=\sigma,\\
g'=g,
\\
m'=m,\\
\vec v'=A(t)\cdot \vec v+\frac{dA}{dt}(t)\cdot\vec x+\frac{d\vec z}{dt}(t),\\
\vec A'=A(t)\cdot \vec A,\\
\Psi'-\vec v'\cdot\vec A'=\Psi-\vec v\cdot\vec A,\\
\psi'_1=U(t)\cdot\psi_1,\\
\psi'_2=U(t)\cdot\psi_2,
\end{cases}
\end{equation}
where, as before, $U(t)\in SU(2)$ is characterized by:
\begin{equation}\label{gyfyfgfgfgghZZjkjkint}
U^*(t)\cdot\vec S\cdot U(t)=A(t)\cdot\vec S,
\end{equation}
that means
\begin{multline*}
\left(U^*(t)\cdot S_1\cdot U(t),U^*(t)\cdot S_2\cdot
U(t),U^*(t)\cdot S_3\cdot U(t)\right)=\\
\left(a_{11}(t)S_1+a_{12}(t)S_2+a_{13}(t)S_3\,,\,a_{21}(t)S_1+a_{22}(t)S_2+a_{23}(t)S_3\,,\,a_{31}(t)S_1+a_{32}(t)S_2+a_{33}(t)S_3\right),
\end{multline*}
where $A(t)=\left\{a_{mk}(t)\right\}_{\{1\leq m,k\leq 3\}}$.
\end{theorem}

Next, in the case that our particle has a positive energy, define
$$(\phi_1,\phi_2)=\left(e^{-\frac{ic^2mt}{\hbar}}\psi_1,e^{-\frac{ic^2mt}{\hbar}}\psi_2\right).$$
Then, as we show in section \ref{fgffggfggfhhjh}, we rewrite
\er{vhfffngghkjgghfjjghghghSYShmyuuiiuuhmhmiopoopnniukjhjkk;l;lkkkJJint}
and
\er{vhfffngghkjgghfjjghghghSYShmyuuiiuuhmhmiopoopnniukjhjkhhjl;kkJJint}
in the non-relativistic limit as:
\begin{equation}\label{vhfffngghkjgghfjjghghghSYShmyuuiiuuhmhmiopoopnniukjhjkhhjl;jkljhjjkllkjhykkJJint}
\phi_2\approx-\frac{1}{2c m}\vec S\cdot\left(i\hbar\nabla_{\vec
x}\phi_1+\frac{\sigma}{c}\vec A(\vec x,t)\phi_1\right).
%
%
%
\end{equation}
and:
\begin{multline}\label{vhfffngghkjgghfjjghghghSYShmyuuiiuuhmhmiopoopnniukjhjkk;l;lkhjjkihjjhkkJJllkkkint}
i\hbar\frac{\partial\phi_1}{\partial
t}\approx-\frac{\hbar^2}{2m}\Delta_{\vec
x}\phi_1+\frac{i\hbar\sigma}{2mc}div_{\vec x}\left\{\phi_1\vec
A(\vec x,t)\right\}+\frac{i\hbar\sigma}{2mc}\nabla_{\vec
x}\phi_1\cdot\vec A(\vec x,t)+\frac{\sigma^2}{2mc^2}\left|\vec
A(\vec x,t)\right|^2\phi_1\\-\frac{g\sigma\hbar}{2mc}\vec
S\cdot\left(curl_{\vec x}\vec A(\vec x,t)\,\phi_1\right)
%
%
%
+\sigma\left(\Psi(\vec x,t)-\frac{1}{c}\vec v(\vec x,t)\cdot\vec
A(\vec x,t)\right)\phi_1-V\left(\vec
x,t\right)\phi_1\\-\frac{i\hbar}{2}div_{\vec x}\left\{\phi_1\vec
v(\vec x,t)\right\}-\frac{i\hbar}{2}\nabla_{\vec x}\phi_1\cdot\vec
v(\vec x,t)+\frac{\hbar}{4}\vec S\cdot\left(curl_{\vec x}\vec v(\vec
x,t)\,\phi_1\right).
\end{multline}
The last equation coincides with the non-relativistic
Shr\"{o}dinger-Pauli equation, that we studied above.

Next, consider a Lagrangian density $L$ associated with the motion
of a spin-half quantum relativistic-like micro-particle with
inertial mass $m$ and the charge $\sigma$ in the outer gravitational
and electromagnetical field with characteristics $\vec v(\vec x,t)$,
$\vec A(\vec x,t)$ and $\Psi(\vec x,t)$ and additional conservative
field with potential $V(\vec y,t)$:
\begin{multline}\label{vhfffngghkjgghDDmmkkkZZkkkint}
L\left(\psi,\vec x,t\right):=
\frac{i\hbar}{2}\left(\left(\frac{\partial\psi_1}{\partial t}+\vec
v\cdot\nabla_{\vec
x}\psi_1\right)\cdot\bar\psi_1-\psi_1\cdot\left(\frac{\partial\bar\psi_1}{\partial
t}+\vec v\cdot\nabla_{\vec
x}\bar\psi_1\right)\right)\\+\frac{c}{2}\left(\left(\vec
S\cdot\left(i\hbar\nabla_{\vec x}\psi_2+\frac{\sigma}{c}\vec A(\vec
x,t)\psi_2\right)\right)\cdot\bar\psi_1-\psi_1\cdot\left(\vec
S\cdot\left(i\hbar\nabla_{\vec x}\bar\psi_2-\frac{\sigma}{c}\vec
A(\vec x,t)\bar\psi_2\right)\right)\right)\\+
\frac{i\hbar}{2}\left(\left(\frac{\partial\psi_2}{\partial t}+\vec
v\cdot\nabla_{\vec
x}\psi_2\right)\cdot\bar\psi_2-\psi_2\cdot\left(\frac{\partial\bar\psi_2}{\partial
t}+\vec v\cdot\nabla_{\vec
x}\bar\psi_2\right)\right)\\
+\frac{c}{2}\left(\left(\vec S\cdot\left(i\hbar\nabla_{\vec
x}\psi_1+\frac{\sigma}{c}\vec A(\vec
x,t)\psi_1\right)\right)\cdot\bar\psi_2-\psi_2\cdot\left(\vec
S\cdot\left(i\hbar\nabla_{\vec x}\bar\psi_1-\frac{\sigma}{c}\vec
A(\vec x,t)\bar\psi_1\right)\right) \right)
\\-mc^2\left(\psi_1\cdot\bar\psi_1-\psi_2\cdot\bar\psi_2\right)
-\sigma\left(\Psi-\frac{1}{c}\vec v\cdot\vec
A\right)\left(\psi_1\cdot\bar\psi_1+\psi_2\cdot\bar\psi_2\right)+V\left(\vec
x,t\right)\left(\psi_1\cdot\bar\psi_1+\psi_2\cdot\bar\psi_2\right)\\-\frac{\hbar}{4}\left(\vec
S\cdot\left(curl_{\vec x}\vec
v\right)\psi_1\right)\cdot\bar\psi_1-\frac{\hbar}{4}\left(\vec
S\cdot\left(curl_{\vec x}\vec
v\right)\psi_2\right)\cdot\bar\psi_2\\+\frac{(g-1)\sigma\hbar}{2mc}\left(\vec
S\cdot\left(curl_{\vec x}\vec
A\right)\psi_1\right)\cdot\bar\psi_1+\frac{(g-1)\sigma\hbar}{2mc}\left(\vec
S\cdot\left(curl_{\vec x}\vec A\right)\psi_2\right)\cdot\bar\psi_2,
\end{multline}
where $\psi=(\psi_1,\psi_2)\in\mathbb{C}^2\times \mathbb{C}^2$ is a
four-component wave function. Then similarly to the proof of Theorem
\ref{gjghghgghgintintrrZZjkhint} we can prove that $L$ is invariant
under the change of inertial or non-inertial cartesian coordinate
system, given by
\er{noninchgravortbstrjgghguittu2intrrrZZjkjjkjkjkint}, provided
that we take into account
\er{yuythfgfyftydtydtydtyddyyyhhddhhhredPPN111hgghjgintintrrZZkjhhint}.
Moreover, if we consider a functional
\begin{equation}\label{btfffygtgyggyDDmmkkkZZkkkint}
J=\int_0^T\int_{\mathbb{R}^3}L\left(\psi,\vec x,t\right)d\vec x dt,
\end{equation}
then, by \er{vhfffngghkjgghDDmmkkkZZkkkint} we deduce the
Euler-Lagranges equation for \er{btfffygtgyggyDDmmkkkZZkkkint}
coincide with Dirac equations in the form of
\er{vhfffngghkjgghfjjghghghSYShmyuuiiuuhmhmiopoopnniukjhjkk;l;lkkkJJjjkjkint}
and
\er{vhfffngghkjgghfjjghghghSYShmyuuiiuuhmhmiopoopnniukjhjkhhjl;kkJJkllkklint}.
Next, as before, we would like to note that, as before, the
Lagrangian density $L$, defined by
\er{vhfffngghkjgghDDmmkkkZZkkkint} obeys $U(1)$ local symmetry, i.e.
for every scalar field $w:=w(\vec x,t)$ one can easily deduce that
$L$ in \er{vhfffngghkjgghDDmmkkkZZkkkint} is invariant under the
transformation:
\begin{equation}\label{MaxVacFull1bjkgjhjhgjgjgkjfhjfdghghligioiuittrPPNhjkjhkjgghhjjhjintiiihhintspggjjf}
\begin{cases}
\psi_1\,\to\,e^{-\frac{i\sigma w}{c\hbar}}\,\psi_1\\
\psi_2\,\to\,e^{-\frac{i\sigma w}{c\hbar}}\,\psi_2
\\
\Psi\,\to\,\Psi+\frac{1}{c}\frac{\partial w}{\partial t}\\
\vec A\,\to\,\vec A-\nabla_{\vec x}w\\
\vec v\,\to\,\vec v.
\end{cases}
\end{equation}

See also subsection \ref{ghghggjghfgjknew} for generalizations of
Dirac equation for a system of $n$ spin-half stable particles. See
also subsection \ref{hilghjhghfdgdg1133} for the Quantum Liouville's
equation for a finite system of spin-half relativistic-like stable
particles.

\subsection{Spinless relativistic-like particles: the Klein--Gordon
equation} Consider the motion of a relativistic-like spinless
quantum micro-particle with inertial masses $m$ and the charge
$\sigma$ in the outer gravitational and electromagnetical field with
characteristics $\vec v(\vec x,t)$, $\vec A(\vec x,t)$ and
$\Psi(\vec x,t)$ and additional conservative field with potential
$V(\vec x,t)$. The Klein--Gordon equation for this system of
particles is the following:
\begin{multline}\label{vhfffngghkjgghfjjghghghhjghjgghkghgghjhggjjkgfgdiyfgfjkjgjgSYSPNxxint}
\frac{\hbar^2}{2c^2m}\frac{\partial}{\partial
t}\left(\frac{\partial\psi}{\partial t}+\vec v(\vec
x,t)\cdot\nabla_{\vec x}\psi+\frac{i\sigma}{\hbar}\Psi_0(\vec
x,t)\psi\right)\\+\frac{\hbar^2}{2c^2m}\Div_{\vec x}\left\{\vec
v(\vec x,t)\left(\frac{\partial\psi}{\partial t}+\vec v(\vec
x,t)\cdot\nabla_{\vec x}\psi+\frac{i\sigma}{\hbar}\Psi_0(\vec
x,t)\psi\right)\right\}\\
+\frac{\hbar^2}{2c^2m}\left(\frac{i\sigma}{\hbar}\Psi_0(\vec
x,t)\right)\left(\frac{\partial\psi}{\partial t}+\vec v(\vec
x,t)\cdot\nabla_{\vec x}\psi+\frac{i\sigma}{\hbar}\Psi_0(\vec
x,t)\psi\right) +\frac{c^2m}{2}\psi-\frac{\hbar^2}{2m}\Delta_{\vec
x}\psi-V\left(\vec x,t\right)\psi\\+\frac{
i\hbar\sigma}{2mc}div_{\vec x}\left\{\psi\vec A(\vec
x,t)\right\}+\frac{ i\hbar\sigma}{2mc}\vec A(\vec
x,t)\cdot\nabla_{\vec x}\psi+\left(
\frac{\sigma^2}{2mc^2}\left|\vec A(\vec x,t)\right|^2\right)\psi=0,
\end{multline}
where, as before,
\begin{equation}\label{noninchgravortbstrghgggSYSPNxxjjkkjxxint}
\Psi_0(\vec x,t):=\Psi(\vec x,t)-\frac{1}{c}\vec v(\vec
x,t)\cdot\vec A(\vec x,t)
\end{equation}
is the proper electrical potential and $\psi=\psi\left(\vec
x,t\right)\in\mathbb{C}$ is the scalar wave function. Next consider
a change of some non-inertial cartesian coordinate system $(*)$ to
another cartesian coordinate system $(**)$ of the form:
\begin{equation}\label{noninchgravortbstrghgggSYSPNxxint}
\begin{cases}
\vec x'=A(t)\cdot\vec x+\vec z(t),\\
t'=t,
\end{cases}
\end{equation}
where $A(t)\in SO(3)$ is a rotation.
Then, we deduce that  the  Klein--Gordon  equation of the form
\er{vhfffngghkjgghfjjghghghhjghjgghkghgghjhggjjkgfgdiyfgfjkjgjgSYSPNxxint}
is invariant under the change of non-inertial cartesian coordinate
system, provided that under \er{noninchgravortbstrghgggSYSPNxxint}
we have
\begin{equation}\label{vyfgjhgjhvhgghSYSPNxxint}
\begin{cases}
\psi'=\psi
\\
V'=V
\\
\vec A'=A(t)\cdot\vec A
\\
\Psi'_0:=\Psi'-\frac{1}{c}\vec A'\cdot\vec v'=\Psi-\frac{1}{c}\vec
A\cdot\vec v:=\Psi_0.
\end{cases}
\end{equation}

Next defining
\begin{equation}\label{noninchgravortbstrghgggSYSPNxxjjjkjkxxint}
\psi_1\left(\vec x,t\right):=e^{\frac{ic^2mt}{\hbar}}\psi\left(\vec
x,t\right)
\end{equation}
we rewrite
\er{vhfffngghkjgghfjjghghghhjghjgghkghgghjhggjjkgfgdiyfgfjkjgjgSYSPNxxint}
as:
\begin{multline}\label{vhfffngghkjgghfjjghghghhjghjgghkghgghjhggjjkgfgdiyfgfjkjgjgSYSPNxxjkhhkxxhjjhjhxxint}
\frac{\hbar^2}{2c^2m}\frac{\partial}{\partial
t}\left(\frac{\partial\psi_1}{\partial t}+\vec v(\vec
x,t)\cdot\nabla_{\vec x}\psi_1+\frac{i\sigma}{\hbar}\Psi_0(\vec
x,t)\psi_1\right)\\+\frac{\hbar^2}{2c^2m}\Div_{\vec x}\left\{\vec
v(\vec x,t)\left(\frac{\partial\psi_1}{\partial t}+\vec v(\vec
x,t)\cdot\nabla_{\vec x}\psi_1+\frac{i\sigma}{\hbar}\Psi_0(\vec
x,t)\psi_1\right)\right\}
\\
+\frac{\hbar^2}{2c^2m}\left(\frac{i\sigma}{\hbar}\Psi_0(\vec
x,t)\right)\left(\frac{\partial\psi_1}{\partial t}+\vec v(\vec
x,t)\cdot\nabla_{\vec x}\psi_1+\frac{i\sigma}{\hbar}\Psi_0(\vec
x,t)\psi_1\right)
\\
-i\hbar\left(\frac{\partial\psi_1}{\partial t}+\frac{1}{2}\vec
v(\vec x,t)\cdot\nabla_{\vec x}\psi_1+\frac{1}{2}\Div_{\vec
x}\left\{\psi_1\vec v(\vec
x,t)\right\}+\frac{i\sigma}{\hbar}\Psi_0(\vec
x,t)\psi_1\right)-\frac{\hbar^2}{2m}\Delta_{\vec
x}\psi_1-V\left(\vec x,t\right)\psi_1\\+\frac{
i\hbar\sigma}{2mc}div_{\vec x}\left\{\psi_1\vec A(\vec
x,t)\right\}+\frac{ i\hbar\sigma}{2mc}\vec A(\vec
x,t)\cdot\nabla_{\vec x}\psi_1+\left(
\frac{\sigma^2}{2mc^2}\left|\vec A(\vec
x,t)\right|^2\right)\psi_1=0,
\end{multline}
see subsection \ref{hggyugyuy1xx} for the details. Thus by
\er{vhfffngghkjgghfjjghghghhjghjgghkghgghjhggjjkgfgdiyfgfjkjgjgSYSPNxxjkhhkxxhjjhjhxxint}
in the non-relativistic limit we obtain:
\begin{multline}\label{vhfffngghkjgghfjjghghghhjghjgghkghgghjhggjjkgfgdiyfgfjkjgjgSYSPNjjjint}
i\hbar\left(\frac{\partial\psi_1}{\partial t}+\vec v(\vec
x,t)\cdot\nabla_{\vec
x}\psi_1\right)+\frac{i\hbar}{2}\left(div_{\vec x}\vec v(\vec
x,t)\right)\psi_1\approx -\frac{\hbar^2}{2m}\Delta_{\vec
x}\psi_1-V\left(\vec x,t\right)\psi_1+\frac{ i\hbar\sigma}{2mc}\vec
A(\vec x,t)\cdot\nabla_{\vec x}\psi_1\\+\frac{
i\hbar\sigma}{2mc}div_{\vec x}\left\{\psi_1\vec A(\vec
x,t)\right\}+\left(\sigma\Psi(\vec x,t)-\frac{\sigma}{c}\vec A(\vec
x,t)\cdot\vec v(\vec x,t)+\frac{\sigma^2}{2mc^2}\left|\vec A(\vec
x,t)\right|^2\right)\psi_1,
\end{multline}
that coincides with the Schr\"{o}dinger equation of the form
\er{vhfffngghkjgghfjjghghghhjghjgghkghgghjhggjjkgfgdiyfgfjkjgjgint}.

Next, again consider the motion of a quantum micro-particle having
inertial masses $m$ and the charge $\sigma$ with the known
gravitational and electromagnetical field with potentials $\vec
v(\vec x,t)$, $\vec A(\vec x,t)$ and $\Psi(\vec x,t)$ and additional
conservative field with potential $V(\vec x,t)$. Then consider a
Lagrangian density $L$ defined by
\begin{multline}\label{vhfffngghkjgghDDmmkkksmfggffgZZxxint}
L_2\left(\psi,\vec A,\Psi,\vec v,\vec
x,t\right):=-\frac{c^2m}{2}\psi\cdot\bar\psi-\frac{\hbar^2}{2m}\nabla_{\vec
x}\psi\cdot\nabla_{\vec
x}\bar\psi\\
+\frac{\hbar^2}{2c^2m}\left(\frac{\partial\psi}{\partial t}+\vec
v(\vec x,t)\cdot\nabla_{\vec x}\psi+\frac{i\sigma}{\hbar}\Psi_0(\vec
x,t)\psi\right)\left(\frac{\partial\bar\psi}{\partial t}+\vec v(\vec
x,t)\cdot\nabla_{\vec x}\bar\psi-\frac{i\sigma}{\hbar}\Psi_0(\vec
x,t)\bar\psi\right)\\
-\frac{\hbar\sigma i}{2mc}\left(\nabla_{\vec
x}\psi\cdot\bar\psi-\psi\cdot\nabla_{\vec x}\bar\psi\right)\cdot\vec
A(\vec x,t)-\frac{\sigma^2}{2mc^2}\left|\vec A(\vec
x,t)\right|^2\psi\cdot\bar\psi+V\left(\vec
x,t\right)\psi\cdot\bar\psi ,
\end{multline}
where $\psi(\vec x,t)\in\mathbb{C}$ is a wave function of the
system. Then, as before, it can be proven that $L$ is invariant
under the change of inertial or non-inertial cartesian coordinate
systems of the form \er{noninchgravortbstrghgggSYSPNxxint}, provided
that $\psi'=\psi$. We investigate stationary points of the
functional
\begin{equation}\label{btfffygtgyggyDDmmkkksmZZxxint}
J=\int_0^T\int_{\mathbb{R}^3}L\left(\psi,\vec A,\Psi,\vec v,\vec
x,t\right)d\vec xdt.
\end{equation}
Then,
\begin{multline}\label{vhfffngghkjgghDDmmkkkghgsmZZxxint}
0=\frac{\delta L_2}{\delta (\bar\psi)}=
-\frac{\hbar^2}{2c^2m}\frac{\partial}{\partial
t}\left(\frac{\partial\psi}{\partial t}+\vec v(\vec
x,t)\cdot\nabla_{\vec x}\psi+\frac{i\sigma}{\hbar}\Psi_0(\vec
x,t)\psi\right)\\-\frac{\hbar^2}{2c^2m}\Div_{\vec x}\left\{\vec
v(\vec x,t)\left(\frac{\partial\psi}{\partial t}+\vec v(\vec
x,t)\cdot\nabla_{\vec x}\psi+\frac{i\sigma}{\hbar}\Psi_0(\vec
x,t)\psi\right)\right\}
\\
-\frac{\hbar^2}{2c^2m}\left(\frac{i\sigma}{\hbar}\Psi_0(\vec
x,t)\right)\left(\frac{\partial\psi}{\partial t}+\vec v(\vec
x,t)\cdot\nabla_{\vec x}\psi+\frac{i\sigma}{\hbar}\Psi_0(\vec
x,t)\psi\right) -\frac{c^2m}{2}\psi+\frac{\hbar^2}{2m}\Delta_{\vec
x}\psi\\ -\frac{\hbar\sigma i}{2mc}\left(\vec A(\vec
x,t)\cdot\nabla_{\vec x}\psi+div_{\vec x}\left\{\psi\vec A(\vec
x,t)\right\}\right) -\frac{\sigma^2}{2mc^2}\left|\vec A(\vec
x,t)\right|^2\psi
+V\left(\vec x,t\right)\psi,
\end{multline}
and
\begin{multline}\label{vhfffngghkjgghDDmmkkkghguhyuysmZZxxint}
0=\frac{\delta L_2}{\delta (\psi)}=
-\frac{\hbar^2}{2c^2m}\frac{\partial}{\partial
t}\left(\frac{\partial\bar\psi}{\partial t}+\vec v(\vec
x,t)\cdot\nabla_{\vec x}\bar\psi+\frac{\bar
{(i)}\sigma}{\hbar}\Psi_0(\vec
x,t)\bar\psi\right)\\-\frac{\hbar^2}{2c^2m}\Div_{\vec x}\left\{\vec
v(\vec x,t)\left(\frac{\partial\bar\psi}{\partial t}+\vec v(\vec
x,t)\cdot\nabla_{\vec x}\bar\psi+\frac{\bar
{(i)}\sigma}{\hbar}\Psi_0(\vec x,t)\bar\psi\right)\right\}
\\
-\frac{\hbar^2}{2c^2m}\left(\frac{\bar
{(i)}\sigma}{\hbar}\Psi_0(\vec
x,t)\right)\left(\frac{\partial\bar\psi}{\partial t}+\vec v(\vec
x,t)\cdot\nabla_{\vec x}\bar\psi+\frac{\bar
{(i)}\sigma}{\hbar}\Psi_0(\vec x,t)\bar\psi\right)
-\frac{c^2m}{2}\bar\psi+\frac{\hbar^2}{2m}\Delta_{\vec x}\bar\psi
\\-\frac{\hbar\sigma \bar {(i)}}{2mc}\left(\vec A(\vec
x,t)\cdot\nabla_{\vec x}\bar\psi+div_{\vec x}\left\{\bar\psi\vec
A(\vec x,t)\right\}\right)-\frac{\sigma^2}{2mc^2}\left|\vec A(\vec
x,t)\right|^2\bar\psi
+V\left(\vec x,t\right)\bar\psi,
\end{multline}
where the last equality is just the complex conjugate of
\er{vhfffngghkjgghDDmmkkkghgsmZZxxint}. So we get that the
Euler-Lagrange equation for \er{btfffygtgyggyDDmmkkksmZZxxint}
coincides with the Klein--Gordon  equation of the form
\er{vhfffngghkjgghfjjghghghhjghjgghkghgghjhggjjkgfgdiyfgfjkjgjgSYSPNxxint}.
Next we would like to note that the Lagrangian density $L_2$,
defined by \er{vhfffngghkjgghDDmmkkksmfggffgZZxxint} obeys $U(1)$
local symmetry, i.e. for every scalar field $w:=w(\vec x,t)$ one can
easily deduce that $L_2$ in
\er{vhfffngghkjgghDDmmkkksmfggffgZZxxint} is invariant under the
transformation:
\begin{equation}\label{MaxVacFull1bjkgjhjhgjgjgkjfhjfdghghligioiuittrPPNhjkjhkjgghhjjhjintiiihhhjhjhjhint}
\begin{cases}
\psi\,\to\,e^{-\frac{i\sigma w}{c\hbar}}\,\psi
\\
\Psi\,\to\,\Psi+\frac{1}{c}\frac{\partial w}{\partial t}\\
\vec A\,\to\,\vec A-\nabla_{\vec x}w\\
\vec v\,\to\,\vec v\\
\Psi_0\,\to\,\Psi_0+\frac{1}{c}\left(\frac{\partial w}{\partial
t}+\vec v\cdot\nabla_{\vec x}w\right).
\end{cases}
\end{equation}
See subsection \ref{hggyugyuy1xx} for the generalization of all the
above to the system of $n$ relativistic-like spinless particles.

\subsection{Thermodynamics of a moving continuum medium} Again,
consistently with
\er{noninchgravortbstrjgghguittu2gjgghhjhghjhjgghgghghghtytythvfghfgghjggint},
consider in some cartesian coordinate system $(*)$ the second Law of
Newton for the moving continuum medium with the inertial mass
density $\mu$, the field of average (macroscopic) velocities $\vec
u$, the charge density $\rho$ and the electric current density $\vec
j$:
\begin{multline}\label{MaxVacFull1ninshtrgravortghhghgjkgghklhjgkghghjjkjhjkkggjkhjkhjjhhfhjhkjkhbbgjhzzzyyykkknnnNWBWHWPPNjjint}
\mu\left(\frac{\partial\vec u}{\partial t}+d_{\vec x}\vec u\cdot
\vec u\right)=\frac{\partial(\mu\vec u)}{\partial t}+\Div_{\vec
x}\left\{\mu\vec u\otimes\vec u\right\}=\\-\mu\vec u\times
curl_{\vec x}\vec v+\mu\left(\partial_{t}\vec
v+\nabla_{\vec x}\frac{1}{2}|\vec v|^2\right)+\rho\vec E+\frac{1}{c}\vec j\times\vec B+\Div_{\vec x}\mathcal{T}=\\
=-\mu(\vec u-\vec v)\times curl_{\vec x}\vec v
+\mu\left(\partial_t\vec v+ d_{\vec x}\vec v\cdot\vec
v\right)+\rho\vec E+\frac{1}{c}\vec j\times\vec B+\Div_{\vec
x}\mathcal{T}.
\end{multline}
Here $\rho\vec E+\frac{1}{c}\vec j\times\vec B$ is the volume
density of the Lorentz force where $\vec E$ and $\vec B$ are outer
electric and magnetic fields, assumed to be changing smoothly and
almost constant in the microscopic level, $\vec v$ is a vectorial
gravitational potential also assumed to be changing smoothly and
almost constant in the microscopic level, and $\mathcal{T}\in
\R^{3\times 3}$ is the symmetric Cauchy stress tensor of the
continuum medium. Moreover, the mass density $\mu$, clearly
satisfies the continuum equation:
\begin{equation}\label{MaxVacFull1ninshtrgravortghhghgjkgghklhjgkghghjjkjhjkkggjkhjkhjjhhfhjhkhjjkhjgzzzyyykkknnnNWBWHWPPNjjint}
\frac{\partial\mu}{\partial t}+div_{\vec x}\left(\mu\vec u\right)=0.
\end{equation}
In particular, multiplying
\er{MaxVacFull1ninshtrgravortghhghgjkgghklhjgkghghjjkjhjkkggjkhjkhjjhhfhjhkjkhbbgjhzzzyyykkknnnNWBWHWPPNjj}
by $\vec u$ and using
\er{MaxVacFull1ninshtrgravortghhghgjkgghklhjgkghghjjkjhjkkggjkhjkhjjhhfhjhkhjjkhjgzzzyyykkknnnNWBWHWPPNjj}
we deduce the equality of the balance of the kinetic energy:
\begin{multline}\label{MaxVacFull1ninshtrgravortghhghgjkgghklhjgkghghjjkjhjkkggjkhjkhjjhhfhjhkjkhbbgjhzzzyyykkknnnNWBWHWPPNjjnkjhbbjint}
\frac{\partial}{\partial t}\left(\frac{\mu}{2}\left|\vec
u\right|^2\right)+\Div_{\vec x}\left\{\left(\frac{\mu}{2}\left|\vec
u\right|^2\right)\vec u\right\}=\mu\left(\frac{\partial}{\partial
t}\left(\frac{1}{2}\left|\vec u\right|^2\right)+\vec
u\cdot\nabla_{\vec x}\left(\frac{1}{2}\left|\vec
u\right|^2\right)\right)=\\ \mu\left(\partial_{t}\vec v+\nabla_{\vec
x}\frac{1}{2}|\vec v|^2\right)\cdot\vec u+\left(\rho\vec
E+\frac{1}{c}\vec j\times\vec B\right)\cdot\vec u+\left(\Div_{\vec
x}\mathcal{T}\right)\cdot\vec u =\\ \mu\left(\partial_{t}\vec
v+\nabla_{\vec x}\frac{1}{2}|\vec v|^2\right)\cdot\vec u+\rho\vec
E\cdot\vec u-\frac{1}{c}\left(\vec u\times\vec B\right)\cdot\vec
j+\left(\Div_{\vec x}\mathcal{T}\right)\cdot\vec u.
\end{multline}

Next, it is well known (see \cite{SM}), that the First Law of
Thermodynamics of this moving medium has the following form:
\begin{multline}\label{MaxVacFull1ninshtrgravortghhghgjkgghklhjgkghghjjkjhjkkggjkhjkhjjhhfhjhkjkhbbgjhzzzyyykkknnnNWBWHWPPNjjhjhjhjhint}
\frac{\partial E}{\partial t}+\Div_{\vec x}\left\{E\vec
u\right\}=\mu\left(\frac{\partial}{\partial
t}\left(\frac{E}{\mu}\right)+\vec u\cdot\nabla_{\vec
x}\left(\frac{E}{\mu}\right)\right)\\ =\frac{1}{2}\left(d_{\vec
x}\vec u+\left\{d_{\vec x}\vec
u\right\}^T\right):\mathcal{T}-\Div_{\vec x}\vec q+\left(\vec
j-\rho\vec u\right)\cdot\left(\vec E+\frac{1}{c}\vec u\times\vec
B\right).
\end{multline}
Here $E$ is the volume density of the internal energy (energy per
unit volume) and consistently $\frac{E}{\mu}$ is the internal energy
per unit mass, $\vec q$ is the heat flux and
$$\left(\vec j-\rho\vec u\right)\cdot\left(\vec E+\frac{1}{c}\vec
u\times\vec B\right)$$ is the Joules heat term. In particular,
adding
\er{MaxVacFull1ninshtrgravortghhghgjkgghklhjgkghghjjkjhjkkggjkhjkhjjhhfhjhkjkhbbgjhzzzyyykkknnnNWBWHWPPNjjhjhjhjhint}
with
\er{MaxVacFull1ninshtrgravortghhghgjkgghklhjgkghghjjkjhjkkggjkhjkhjjhhfhjhkjkhbbgjhzzzyyykkknnnNWBWHWPPNjjnkjhbbjint}
and using the symmetry of $\mathcal{T}$, we deduce the following
equality of the balance of the energy:
\begin{multline}\label{MaxVacFull1ninshtrgravortghhghgjkgghklhjgkghghjjkjhjkkggjkhjkhjjhhfhjhkjkhbbgjhzzzyyykkknnnNWBWHWPPNjjnkjhbbjhjjhjhint}
\frac{\partial}{\partial t}\left(\frac{\mu}{2}\left|\vec
u\right|^2+E\right)+\Div_{\vec
x}\left\{\left(\frac{\mu}{2}\left|\vec u\right|^2+E\right)\vec
u\right\}=\mu\left(\frac{\partial}{\partial
t}\left(\frac{1}{2}\left|\vec u\right|^2+\frac{E}{\mu}\right)+\vec
u\cdot\nabla_{\vec x}\left(\frac{1}{2}\left|\vec
u\right|^2+\frac{E}{\mu}\right)\right)\\= \Div_{\vec
x}\left(\mathcal{T}\cdot\vec u\right)-\Div_{\vec x}\vec q+
\mu\left(\partial_{t}\vec v+\nabla_{\vec x}\frac{1}{2}|\vec
v|^2\right)\cdot\vec u+\vec E\cdot\vec j.
\end{multline}

Next the Second Law of Thermodynamics states that
\begin{equation}\label{MaxVacFull1ninshtrgravortghhghgjkgghklhjgkghghjjkjhjkkggjkhjkhjjhhfhjhkjkhbbgjhzzzyyykkknnnNWBWHWPPNjjhjhjhjhkpiint}
T\left(\frac{\partial \mathcal{S}}{\partial t}+\Div_{\vec
x}\left\{\mathcal{S}\vec
u\right\}\right)=T\mu\left(\frac{\partial}{\partial
t}\left(\frac{\mathcal{S}}{\mu}\right)+\vec u\cdot\nabla_{\vec
x}\left(\frac{\mathcal{S}}{\mu}\right)\right)\geq -\Div_{\vec x}\vec
q.
\end{equation}
Here $T:=T(\vec x,t)$ is the Kelvin temperature field and
$\mathcal{S}$ is the volume density of the entropy (entropy per unit
volume) and consistently $\frac{\mathcal{S}}{\mu}$ is the entropy
per unit mass. Moreover, we have the equality in
\er{MaxVacFull1ninshtrgravortghhghgjkgghklhjgkghghjjkjhjkkggjkhjkhjjhhfhjhkjkhbbgjhzzzyyykkknnnNWBWHWPPNjjhjhjhjhkpiint}
in the case of reversible or quasi-reversible process. In the latter
case we rewrite the First Law
\er{MaxVacFull1ninshtrgravortghhghgjkgghklhjgkghghjjkjhjkkggjkhjkhjjhhfhjhkjkhbbgjhzzzyyykkknnnNWBWHWPPNjjhjhjhjhint}
and the Second Law
\er{MaxVacFull1ninshtrgravortghhghgjkgghklhjgkghghjjkjhjkkggjkhjkhjjhhfhjhkjkhbbgjhzzzyyykkknnnNWBWHWPPNjjhjhjhjhkpiint}
together as:
\begin{multline}\label{MaxVacFull1ninshtrgravortghhghgjkgghklhjgkghghjjkjhjkkggjkhjkhjjhhfhjhkjkhbbgjhzzzyyykkknnnNWBWHWPPNjjhjhjhjhkloiojint}
\frac{\partial E}{\partial t}+\Div_{\vec x}\left\{E\vec u\right\}
=T\left(\frac{\partial \mathcal{S}}{\partial t}+\Div_{\vec
x}\left\{\mathcal{S}\vec u\right\}\right)+\frac{1}{2}\left(d_{\vec
x}\vec u+\left\{d_{\vec x}\vec
u\right\}^T\right):\mathcal{T}+\left(\vec j-\rho\vec
u\right)\cdot\left(\vec E+\frac{1}{c}\vec u\times\vec B\right)\\=
T\mu\left(\frac{\partial}{\partial
t}\left(\frac{\mathcal{S}}{\mu}\right)+\vec u\cdot\nabla_{\vec
x}\left(\frac{\mathcal{S}}{\mu}\right)\right)+
\frac{1}{2}\left(d_{\vec x}\vec u+\left\{d_{\vec x}\vec
u\right\}^T\right):\mathcal{T}+\left(\vec j-\rho\vec
u\right)\cdot\left(\vec E+\frac{1}{c}\vec u\times\vec B\right)
\\=\mu\left(\frac{\partial}{\partial
t}\left(\frac{E}{\mu}\right)+\vec u\cdot\nabla_{\vec
x}\left(\frac{E}{\mu}\right)\right) .
\end{multline}
In particular, if the stress tensor have the following particular
form
\begin{equation}\label{MaxVacFull1ninshtrgravortghhghgjkgghklhjgkghghjjkjhjkkggjkhjkhjjhhfhjhkhjjkhjgzzzyyykkknnnNWBWHWPPNjjhhjhjint}
\mathcal{T}=-p\,I,
\end{equation}
where $p$ is the scalar pressure and $I:=Id\in\R^{3\times 3}$ is the
identity matrix, then we rewrite the First Law of Thermodynamics
\er{MaxVacFull1ninshtrgravortghhghgjkgghklhjgkghghjjkjhjkkggjkhjkhjjhhfhjhkjkhbbgjhzzzyyykkknnnNWBWHWPPNjjhjhjhjhint}
as:
\begin{equation}\label{MaxVacFull1ninshtrgravortghhghgjkgghklhjgkghghjjkjhjkkggjkhjkhjjhhfhjhkjkhbbgjhzzzyyykkknnnNWBWHWPPNjjhjhjhjhinthjhjhjhint}
\mu\left(\frac{\partial}{\partial t}\left(\frac{E}{\mu}\right)+\vec
u\cdot\nabla_{\vec x}\left(\frac{E}{\mu}\right)\right)
=-\left(\Div_{\vec x}\vec u\right)p-\Div_{\vec x}\vec q+\left(\vec
j-\rho\vec u\right)\cdot\left(\vec E+\frac{1}{c}\vec u\times\vec
B\right),
\end{equation}
or equivalently as:
\begin{multline}\label{MaxVacFull1ninshtrgravortghhghgjkgghklhjgkghghjjkjhjkkggjkhjkhjjhhfhjhkjkhbbgjhzzzyyykkknnnNWBWHWPPNjjhjhjhjhlllkl;klint}
\frac{\partial}{\partial t}\left(\frac{E}{\mu}\right)+\vec
u\cdot\nabla_{\vec x}\left(\frac{E}{\mu}\right)
=-p\left(\frac{\partial}{\partial t}\left(\frac{1}{\mu}\right)+\vec
u\cdot\nabla_{\vec
x}\left(\frac{1}{\mu}\right)\right)-\frac{1}{\mu}\Div_{\vec x}\vec
q+\frac{1}{\mu}\left(\vec j-\rho\vec u\right)\cdot\left(\vec
E+\frac{1}{c}\vec u\times\vec B\right),
\end{multline}
and in the case of quasi-reversible process we rewrite
\er{MaxVacFull1ninshtrgravortghhghgjkgghklhjgkghghjjkjhjkkggjkhjkhjjhhfhjhkjkhbbgjhzzzyyykkknnnNWBWHWPPNjjhjhjhjhkloiojint}
as:
\begin{multline}\label{MaxVacFull1ninshtrgravortghhghgjkgghklhjgkghghjjkjhjkkggjkhjkhjjhhfhjhkjkhbbgjhzzzyyykkknnnNWBWHWPPNjjhjhjhjhlllklkljkkhhkgkint}
\frac{\partial}{\partial t}\left(\frac{E}{\mu}\right)+\vec
u\cdot\nabla_{\vec x}\left(\frac{E}{\mu}\right)
=-p\left(\frac{\partial}{\partial t}\left(\frac{1}{\mu}\right)+\vec
u\cdot\nabla_{\vec
x}\left(\frac{1}{\mu}\right)\right)+T\left(\frac{\partial}{\partial
t}\left(\frac{\mathcal{S}}{\mu}\right)+\vec u\cdot\nabla_{\vec
x}\left(\frac{\mathcal{S}}{\mu}\right)\right)\\+\frac{1}{\mu}\left(\vec
j-\rho\vec u\right)\cdot\left(\vec E+\frac{1}{c}\vec u\times\vec
B\right),
\end{multline}
where clearly $\frac{1}{\mu}$ is the volume per unit mass. Moreover,
the following inequality always holds:
\begin{equation}\label{MaxVacFull1ninshtrgravortghhghgjkgghklhjgkghghjjkjhjkkggjkhjkhjjhhfhjhkhjjkhjgzzzyyykkknnnNWBWHWPPNjjlkljjkkkjkjkint}
\vec q\cdot\nabla_{\vec x}T\leq 0.
\end{equation}
In particular, by inserting
\er{MaxVacFull1ninshtrgravortghhghgjkgghklhjgkghghjjkjhjkkggjkhjkhjjhhfhjhkhjjkhjgzzzyyykkknnnNWBWHWPPNjjlkljjkkkjkjkint}
into
\er{MaxVacFull1ninshtrgravortghhghgjkgghklhjgkghghjjkjhjkkggjkhjkhjjhhfhjhkjkhbbgjhzzzyyykkknnnNWBWHWPPNjjhjhjhjhkpiint}
we deduce
\begin{equation}\label{MaxVacFull1ninshtrgravortghhghgjkgghklhjgkghghjjkjhjkkggjkhjkhjjhhfhjhkjkhbbgjhzzzyyykkknnnNWBWHWPPNjjhjhjhjhkpihjjgjggjkhhhint}
\frac{\partial \mathcal{S}}{\partial t}+\Div_{\vec
x}\left\{\mathcal{S}\vec u\right\}=\mu\left(\frac{\partial}{\partial
t}\left(\frac{\mathcal{S}}{\mu}\right)+\vec u\cdot\nabla_{\vec
x}\left(\frac{\mathcal{S}}{\mu}\right)\right)\geq -\Div_{\vec
x}\left(\frac{1}{T}\,\vec q\right).
\end{equation}
Note that by
\er{MaxVacFull1ninshtrgravortghhghgjkgghklhjgkghghjjkjhjkkggjkhjkhjjhhfhjhkjkhbbgjhzzzyyykkknnnNWBWHWPPNjjhjhjhjhkpihjjgjggjkhhhint},
the total entropy of an arbitrary moving \underline{thermally
isolated} system is a non-decreasing function of time. Finally, we
remind the approximate Fourier's law (which is consistent with
\er{MaxVacFull1ninshtrgravortghhghgjkgghklhjgkghghjjkjhjkkggjkhjkhjjhhfhjhkhjjkhjgzzzyyykkknnnNWBWHWPPNjjlkljjkkkjkjkint}):
\begin{equation}\label{MaxVacFull1ninshtrgravortghhghgjkgghklhjgkghghjjkjhjkkggjkhjkhjjhhfhjhkhjjkhjgzzzyyykkknnnNWBWHWPPNjjlkljjint}
\vec q=-\chi\nabla_{\vec x}T,
\end{equation}
where $\chi$ is some positive material coefficient (not necessary a
constant).

Next consider the change of certain non-inertial cartesian
coordinate system $(*)$ to another cartesian coordinate system
$(**)$ of the form \er{noninchgravortbstrjgghguittu2int}. Then by
Proposition \ref{yghgjtgyrtrtint} we easily deduce that the Laws in
\er{MaxVacFull1ninshtrgravortghhghgjkgghklhjgkghghjjkjhjkkggjkhjkhjjhhfhjhkjkhbbgjhzzzyyykkknnnNWBWHWPPNjjhjhjhjhint},
\er{MaxVacFull1ninshtrgravortghhghgjkgghklhjgkghghjjkjhjkkggjkhjkhjjhhfhjhkjkhbbgjhzzzyyykkknnnNWBWHWPPNjjhjhjhjhkpiint},
\er{MaxVacFull1ninshtrgravortghhghgjkgghklhjgkghghjjkjhjkkggjkhjkhjjhhfhjhkjkhbbgjhzzzyyykkknnnNWBWHWPPNjjhjhjhjhkloiojint},
\er{MaxVacFull1ninshtrgravortghhghgjkgghklhjgkghghjjkjhjkkggjkhjkhjjhhfhjhkhjjkhjgzzzyyykkknnnNWBWHWPPNjjhhjhjint},
\er{MaxVacFull1ninshtrgravortghhghgjkgghklhjgkghghjjkjhjkkggjkhjkhjjhhfhjhkjkhbbgjhzzzyyykkknnnNWBWHWPPNjjhjhjhjhlllkl;klint},
\er{MaxVacFull1ninshtrgravortghhghgjkgghklhjgkghghjjkjhjkkggjkhjkhjjhhfhjhkjkhbbgjhzzzyyykkknnnNWBWHWPPNjjhjhjhjhlllklkljkkhhkgkint},
\er{MaxVacFull1ninshtrgravortghhghgjkgghklhjgkghghjjkjhjkkggjkhjkhjjhhfhjhkhjjkhjgzzzyyykkknnnNWBWHWPPNjjlkljjkkkjkjkint},
\er{MaxVacFull1ninshtrgravortghhghgjkgghklhjgkghghjjkjhjkkggjkhjkhjjhhfhjhkjkhbbgjhzzzyyykkknnnNWBWHWPPNjjhjhjhjhkpihjjgjggjkhhhint}
and
\er{MaxVacFull1ninshtrgravortghhghgjkgghklhjgkghghjjkjhjkkggjkhjkhjjhhfhjhkhjjkhjgzzzyyykkknnnNWBWHWPPNjjlkljjint}
are invariant under the change of a non-inertial cartesian
coordinate system given by \er{noninchgravortbstrjgghguittu2int},
provided that under \er{noninchgravortbstrjgghguittu2int} we have:
\begin{equation}\label{hjgjgjgkllujkint}
\begin{cases}\mu'=\mu,\\
E'=E,\\
\mathcal{S}'=\mathcal{S},\\
T'=T,\\
\vec q'=A(t)\cdot\vec q,\\
\mathcal{T}'=A(t)\cdot \mathcal{T}\cdot A^T(t)
\\
p'=p,\\
\chi'=\chi,\\
\rho'=\rho,\\
\vec u'=A(t)\cdot \vec u+\frac{dA}{dt}(t)\cdot\vec x+\frac{d\vec z}{dt}(t),\\
\vec v'=A(t)\cdot \vec v+\frac{dA}{dt}(t)\cdot\vec x+\frac{d\vec z}{dt}(t),\\
\vec j'=A(t)\cdot \vec j+\rho \frac{dA}{dt}(t)\cdot\vec
x+\rho\frac{d\vec z}{dt}(t).
\end{cases}
\end{equation}

See also section \ref{gjgggggggggggggggggggggggghkgu} for the
generalization of the Fourier's law
\er{MaxVacFull1ninshtrgravortghhghgjkgghklhjgkghghjjkjhjkkggjkhjkhjjhhfhjhkhjjkhjgzzzyyykkknnnNWBWHWPPNjjlkljjint}
to the case of anisotropic mediums.

\subsubsection{The case of an inviscid fluid/gas}
In the case of an inviscid fluid or gas equality
\er{MaxVacFull1ninshtrgravortghhghgjkgghklhjgkghghjjkjhjkkggjkhjkhjjhhfhjhkhjjkhjgzzzyyykkknnnNWBWHWPPNjjhhjhjint}
indeed holds. As a consequence, equality
\er{MaxVacFull1ninshtrgravortghhghgjkgghklhjgkghghjjkjhjkkggjkhjkhjjhhfhjhkjkhbbgjhzzzyyykkknnnNWBWHWPPNjjhjhjhjhlllkl;klint}
holds, and moreover, in the case of quasi-reversible process
equality
\er{MaxVacFull1ninshtrgravortghhghgjkgghklhjgkghghjjkjhjkkggjkhjkhjjhhfhjhkjkhbbgjhzzzyyykkknnnNWBWHWPPNjjhjhjhjhlllklkljkkhhkgkint}
also holds. Moreover, in the case of a classical ideal gas the
following state equality is well known:
\begin{equation}\label{noninchredPPNbjjhjjkhkhlkkkklint}
p\,=\,\frac{\mu}{m_0}\,k\,T,
\end{equation}
where $m_0$ is the mass of the single molecule of the given gas and
$k$ is the Boltzmann constant. Finally, for the ideal gas, we have
the following expression for the volume density of the internal
energy $E$:
\begin{equation}\label{noninchredPPNbjjhjjkhkhkljjljljklint}
E\,=\,\frac{\mu}{m_0}\,\gamma_0\,k\,T,
\end{equation}
where $\gamma_0>0$ is a constant that depends on the kind of the gas
(for the monatomic gas we have $\gamma_0=\frac{3}{2}$). On the other
hand, in the case of \underline{incompressible} fluid we have
\begin{equation}\label{noninchredPPNbjjhjjkhkhkljjljljklyjhfjint}
div_{\vec x}\vec u\,\equiv \,0,
\end{equation}
and the pressure $p$ is unspecified.

Then, as before, we easily deduce that the Laws in
\er{noninchredPPNbjjhjjkhkhlkkkklint},
\er{noninchredPPNbjjhjjkhkhkljjljljklint} and
\er{noninchredPPNbjjhjjkhkhkljjljljklyjhfjint} are invariant under
the change of a non-inertial cartesian coordinate system given by
\er{noninchgravortbstrjgghguittu2int}, provided that under
\er{noninchgravortbstrjgghguittu2int} we have \er{hjgjgjgkllujkint}.

\subsubsection{The case of the simplest viscous fluid/gas}
In the case of the simplest viscous fluid or gas we have the
following equality, that substitutes
\er{MaxVacFull1ninshtrgravortghhghgjkgghklhjgkghghjjkjhjkkggjkhjkhjjhhfhjhkhjjkhjgzzzyyykkknnnNWBWHWPPNjjhhjhjint}:
\begin{equation}\label{MaxVacFull1ninshtrgravortghhghgjkgghklhjgkghghjjkjhjkkggjkhjkhjjhhfhjhkhjjkhjgzzzyyykkknnnNWBWHWPPNjjhhjhjkpkl;kl;int}
\mathcal{T}=-p\,I+\left(\alpha\left(d_{\vec x}\vec u+\left\{d_{\vec
x}\vec u\right\}^T\right)+\beta\left(div_{\vec x}\vec
u\right)I\right),
\end{equation}
where $\alpha\geq 0$ and $\beta$ are some material coefficients.
Then, as in
\er{MaxVacFull1ninshtrgravortghhghgjkgghklhjgkghghjjkjhjkkggjkhjkhjjhhfhjhkjkhbbgjhzzzyyykkknnnNWBWHWPPNjjhjhjhjhlllkl;klint}
and
\er{MaxVacFull1ninshtrgravortghhghgjkgghklhjgkghghjjkjhjkkggjkhjkhjjhhfhjhkjkhbbgjhzzzyyykkknnnNWBWHWPPNjjhjhjhjhlllklkljkkhhkgkint},
by
\er{MaxVacFull1ninshtrgravortghhghgjkgghklhjgkghghjjkjhjkkggjkhjkhjjhhfhjhkhjjkhjgzzzyyykkknnnNWBWHWPPNjjhhjhjkpkl;kl;int}
we rewrite the First Law of Thermodynamics
\er{MaxVacFull1ninshtrgravortghhghgjkgghklhjgkghghjjkjhjkkggjkhjkhjjhhfhjhkjkhbbgjhzzzyyykkknnnNWBWHWPPNjjhjhjhjhint}
as:
\begin{multline}\label{MaxVacFull1ninshtrgravortghhghgjkgghklhjgkghghjjkjhjkkggjkhjkhjjhhfhjhkjkhbbgjhzzzyyykkknnnNWBWHWPPNjjhjhjhjhlllkl;kljm,n,int}
\frac{\partial}{\partial t}\left(\frac{E}{\mu}\right)+\vec
u\cdot\nabla_{\vec x}\left(\frac{E}{\mu}\right)
=\frac{\alpha}{2}\left|d_{\vec x}\vec u+\left\{d_{\vec x}\vec
u\right\}^T\right|^2+\frac{\beta}{2}\left|div_{\vec x}\vec
u\right|^2
\\-p\left(\frac{\partial}{\partial
t}\left(\frac{1}{\mu}\right)+\vec u\cdot\nabla_{\vec
x}\left(\frac{1}{\mu}\right)\right)-\frac{1}{\mu}\Div_{\vec x}\vec
q+\frac{1}{\mu}\left(\vec j-\rho\vec u\right)\cdot\left(\vec
E+\frac{1}{c}\vec u\times\vec B\right),
\end{multline}
and in the case of quasi-reversible process we rewrite
\er{MaxVacFull1ninshtrgravortghhghgjkgghklhjgkghghjjkjhjkkggjkhjkhjjhhfhjhkjkhbbgjhzzzyyykkknnnNWBWHWPPNjjhjhjhjhkloiojint}
as:
\begin{multline}\label{MaxVacFull1ninshtrgravortghhghgjkgghklhjgkghghjjkjhjkkggjkhjkhjjhhfhjhkjkhbbgjhzzzyyykkknnnNWBWHWPPNjjhjhjhjhlllklkljkkhhkgklkjjljljlint}
\frac{\partial}{\partial t}\left(\frac{E}{\mu}\right)+\vec
u\cdot\nabla_{\vec x}\left(\frac{E}{\mu}\right)
=\frac{\alpha}{2}\left|d_{\vec x}\vec u+\left\{d_{\vec x}\vec
u\right\}^T\right|^2+\frac{\beta}{2}\left|div_{\vec x}\vec
u\right|^2\\-p\left(\frac{\partial}{\partial
t}\left(\frac{1}{\mu}\right)+\vec u\cdot\nabla_{\vec
x}\left(\frac{1}{\mu}\right)\right)+T\left(\frac{\partial}{\partial
t}\left(\frac{\mathcal{S}}{\mu}\right)+\vec u\cdot\nabla_{\vec
x}\left(\frac{\mathcal{S}}{\mu}\right)\right)+\frac{1}{\mu}\left(\vec
j-\rho\vec u\right)\cdot\left(\vec E+\frac{1}{c}\vec u\times\vec
B\right).
\end{multline}
Moreover, in the case of a classical ideal gas the equalities
\er{noninchredPPNbjjhjjkhkhlkkkklint} and
\er{noninchredPPNbjjhjjkhkhkljjljljklint} are still valid in the
case of viscous flow. On the other hand, in the case of
\underline{incompressible} viscous fluid we have
\er{noninchredPPNbjjhjjkhkhkljjljljklyjhfjint} and the pressure $p$
is also unspecified.

Then, as before, by Proposition \ref{yghgjtgyrtrtint} we easily
deduce that the Laws in
\er{MaxVacFull1ninshtrgravortghhghgjkgghklhjgkghghjjkjhjkkggjkhjkhjjhhfhjhkhjjkhjgzzzyyykkknnnNWBWHWPPNjjhhjhjkpkl;kl;int},
\er{MaxVacFull1ninshtrgravortghhghgjkgghklhjgkghghjjkjhjkkggjkhjkhjjhhfhjhkjkhbbgjhzzzyyykkknnnNWBWHWPPNjjhjhjhjhlllkl;kljm,n,int},
\er{MaxVacFull1ninshtrgravortghhghgjkgghklhjgkghghjjkjhjkkggjkhjkhjjhhfhjhkjkhbbgjhzzzyyykkknnnNWBWHWPPNjjhjhjhjhlllklkljkkhhkgklkjjljljlint},
\er{noninchredPPNbjjhjjkhkhlkkkklint},
\er{noninchredPPNbjjhjjkhkhkljjljljklint} and
\er{noninchredPPNbjjhjjkhkhkljjljljklyjhfjint} are invariant under
the change of a non-inertial cartesian coordinate system given by
\er{noninchgravortbstrjgghguittu2int}, provided that under
\er{noninchgravortbstrjgghguittu2int} we have \er{hjgjgjgkllujkint}.

See also the end of subsection \ref{jghgfhhfhfkhhj} for the
generalization of the viscosity law
\er{MaxVacFull1ninshtrgravortghhghgjkgghklhjgkghghjjkjhjkkggjkhjkhjjhhfhjhkhjjkhjgzzzyyykkknnnNWBWHWPPNjjhhjhjkpkl;kl;int}
to the case of an anisotropic Newtonian fluid.

\subsubsection{Lagrangian coordinates and the simplest models of
elastic bodies}\label{hjgyujtgjtututtuint} In some cartesian
coordinate system consider a motion of some continuum medium
occupying a region $\Omega\subset\mathbb{R}^3$ at some fixed instant
of time $t=t_0$ and having the velocity field $\vec u:=\vec u(\vec
x,t)$. Next let $\vec r(t,\vec
y):\mathbb{R}\times\Omega\to\mathbb{R}^3$ be a solution of the
following initial value problem for an ordinary differential
equation:
\begin{equation}\label{hjgghhghhffhkkint}
\begin{cases}
\frac{\partial\vec r}{\partial t}(t,\vec y)=\vec u\left(\vec
r(t,\vec y),t\right)\quad\quad\forall t\in\mathbb{R},\;\;\forall\vec
y\in\Omega\\
\vec r(t_0,\vec y)=\vec y\quad\quad\forall\vec y\in\Omega.
\end{cases}
\end{equation}
Then, clearly $\vec x=\vec r(t,\vec y)$ stands for the spatial
coordinates at the instant of time $t$ of the parcel of continuum,
having initial coordinates $\vec y$. Moreover,
we can deduce that $det\left\{d_{\vec y}\vec r(t,\vec y)\right\}\neq
0$ for every instant of time $t$ and so, for the given instant of
time $t$ the mapping $\vec x=\vec r(t,\vec y)$ is locally invertible
i.e. the equation $\vec x=\vec r(t,\vec y)$ can be resolved in $\vec
y$ (see subsection \ref{hjgyujtgjtututtu} for details). Thus there
exists a regular mapping $\vec y=\vec f(\vec x,t)$, such that
\begin{equation}\label{hjgghhghhffhkkihiyjhjkhjhjhint}
\vec f\left(\vec r(t,\vec y),t\right)=\vec y\quad\forall \vec
y\in\Omega\quad\quad\text{and}\quad\quad\vec r\left(t,\vec f(\vec
x,t)\right)=\vec x.
\end{equation}
Then clearly $\vec y=\vec f(\vec x,t)$ stands for the initial
coordinates of the parcel of continuum, having coordinates $\vec x$
at the instant of time $t$. Next differentiating the first equation
in \er{hjgghhghhffhkkihiyjhjkhjhjhint} by $t$ due to the chain rule
and using \er{hjgghhghhffhkkint} we deduce:
\begin{equation}\label{hjgghhghhffhkkmkljjnnnnint}
\begin{cases}
\frac{\partial\vec f}{\partial t}\left(\vec x,t\right)+d_{\vec
x}\vec f\left(\vec x,t\right)\cdot\vec u\left(\vec x,t\right)=0
\\
\vec f(\vec x,t_0)=\vec x.
\end{cases}
\end{equation}
The \underline{cartesian} coordinates $(\vec x,t)$ are called the
Eulerian coordinates of the continuum medium. In contrast, change of
variables
\begin{equation}\label{hjgghhghhffhkkmkljjnnnnoujkuint}
\begin{cases}
t=t\\
\vec y=\vec f\left(\vec x,t\right)
\end{cases}
\end{equation}
to the new coordinates $(\vec y,t)$ of the space-time leads to
generally non-cartesian \underline{curvilinear} coordinates that
called the Lagrangian or the reference coordinates of the continuum
medium. Next note that in the case of a given scalar field in the
Eulerian coordinates $\Theta=\Theta(\vec x,t)$ and the corresponding
field in Lagrangian coordinates $\Theta_1(\vec
y,t):=\Theta\left(\vec r(t,\vec y),t\right)$, due to the chain rule
and using \er{hjgghhghhffhkkint} we obtain that for
$\frac{\partial\Theta_1}{\partial t}(\vec y,t)$ in the Lagrangian
coordinates corresponds the following expression in the Eulerian
coordinates:
\begin{equation}\label{hjgghhghhffhkkmkljjnnnnoujkujjjjljjint}
\frac{\partial\Theta_1}{\partial t}(\vec
y,t)=\frac{\partial\Theta}{\partial t}\left(\vec x,t\right)+\vec
u\left(\vec x,t\right)\cdot\nabla_{\vec x}\Theta\left(\vec
x,t\right).
\end{equation}
Thus by \er{hjgghhghhffhkkmkljjnnnnoujkujjjjljjint} we also obtain
that in the case of a given vector field in the Eulerian coordinates
$\vec g=\vec g(\vec x,t)$ and the corresponding field in Lagrangian
coordinates $\vec g_1(\vec y,t):=\vec g\left(\vec x,t\right)=\vec
g\left(\vec r(t,\vec y),t\right)$ for $\frac{\partial\vec
g_1}{\partial t}(\vec y,t)$ in the Lagrangian coordinates
corresponds the following expression in the Eulerian coordinates:
\begin{equation}\label{hjgghhghhffhkkmkljjnnnnoujkujjjjljjkhhjkjkint}
\frac{\partial\vec g_1}{\partial t}(\vec y,t)=\frac{\partial\vec
g}{\partial t}\left(\vec x,t\right)+ d_{\vec x}\vec g\left(\vec
x,t\right)\cdot \vec u\left(\vec x,t\right).
\end{equation}

Next, as before, assume that the change of some non-inertial
cartesian system $(*)$ of Eulerian coordinates to another cartesian
system $(**)$ of Eulerian coordinates is of the form:
\begin{equation}\label{jjjjkkkkint}
\begin{cases}
\vec x'=A(t)\cdot\vec x+\vec z(t),\\
t'=t,
\end{cases}
\end{equation}
where $A(t)\in SO(3)$. Then in subsection \ref{hjgyujtgjtututtu} we
deduce that the law of transformation of the Lagrangian coordinates
$(\vec y,t)\to(\vec y',t')$, consistent with \er{jjjjkkkkint}, is
the following:
\begin{equation}\label{jjjjkkkkkhhhhjhjhint}
\begin{cases}
\vec y'=A(t_0)\cdot\vec y+\vec z(t_0),\\
t'=t,
\end{cases}
\end{equation}
i.e. if we define $\vec r'(t',\vec
y'):\mathbb{R}\times\mathbb{R}^3\to\mathbb{R}^3$ as:
\begin{equation}\label{hjgjgjgkllujkjjjkhhjhint}
\vec r'(t',\vec y')=\vec r'\left(t,A(t_0)\cdot\vec y+\vec
z(t_0)\right):=A(t)\cdot\vec r(t,\vec y)+\vec z(t),
\end{equation}
then consistently with \er{hjgghhghhffhkkint} we have
\begin{equation}\label{hjgghhghhffhkklkjljjkjkkjlint}
\begin{cases}
\frac{\partial\vec r'}{\partial t'}(t',\vec y')=\vec u'\left(\vec
r'(t',\vec y'),t'\right)\\
\vec r'(t_0,\vec y')=\vec y'.
\end{cases}
\end{equation}
Moreover, for inverse mappings we have
\begin{equation}\label{hjgjgjgkllujkjjjkhhjhljjkint}
\vec f'(\vec x',t')=\vec f'\left(A(t)\cdot\vec x+\vec
z(t),t\right)=A(t_0)\cdot\vec f(\vec x,t)+\vec z(t_0).
\end{equation}

Next it is well known that the finite strain tensor
$\mathcal{E}(\vec x,t)\in\mathbb{R}^{3\times 3}$ of an elastic
continuum medium in the cartesian Eulerian coordinates $(\vec x,t)$
has the form:
\begin{equation}\label{hjgjgjgkllujkjjjkhhjhljjkjkljjljjint}
\mathcal{E}(\vec x,t):=\frac{1}{2}\left(I-\left\{d_{\vec x}\vec
f(\vec x,t)\right\}^T\cdot d_{\vec x}\vec f(\vec x,t)\right),
\end{equation}
where the mapping $\vec f(\vec x,t)$ is given by
\er{hjgghhghhffhkkihiyjhjkhjhjhint} and satisfies
\er{hjgghhghhffhkkmkljjnnnnint}. Then by \er{jjjjkkkkint} and
\er{hjgjgjgkllujkjjjkhhjhljjkint} we deduce that under
\er{jjjjkkkkint} the matrix valued field $\mathcal{E}$ transforms
as:
\begin{equation}\label{hjgjgjgkllujkjjjkhhjhljjkjkljjljjlljjkhhjint}
\mathcal{E}'(\vec x',t)=A(t)\cdot\mathcal{E}(\vec x,t)\cdot A^T(t),
\end{equation}
i.e. $\mathcal{E}$ is a proper matrix field as it was defined in
Definition \ref{bggghghgjint}.
\begin{remark}\label{uggghfhfhfgint}
It is quite clear that the finite strain tensor, defined by
\er{hjgjgjgkllujkjjjkhhjhljjkjkljjljjint} depends essentially on the
choice of initial instant of time $t_0$. Thus, the initial time
$t_0$ for \er{hjgjgjgkllujkjjjkhhjhljjkjkljjljjint} is always chosen
(possibly fictitiously) in such a way that
our elastic body is relaxed at time $t_0$.
\end{remark}
Next, in the case of the simplest elastic body we have the following
Hooke's law that is similar to
\er{MaxVacFull1ninshtrgravortghhghgjkgghklhjgkghghjjkjhjkkggjkhjkhjjhhfhjhkhjjkhjgzzzyyykkknnnNWBWHWPPNjjhhjhjkpkl;kl;int}:
\begin{equation}\label{MaxVacFull1ninshtrgravortghhghgjkgghklhjgkghghjjkjhjkkggjkhjkhjjhhfhjhkhjjkhjgzzzyyykkknnnNWBWHWPPNjjhhjhjkphkhkhhklklklint}
\mathcal{T}(\vec x,t)=\left(\alpha\mathcal{E}(\vec
x,t)+\beta\left(tr\left\{\mathcal{E}(\vec
x,t)\right\}\right)I\right),
\end{equation}
where $\alpha$ and $\beta$ are some material coefficients, which are
not necessary constant. Here $\mathcal{T}(\vec x,t)$ is the Cauchy
stress tensor appearing in the equations of the motion of the medium
\er{MaxVacFull1ninshtrgravortghhghgjkgghklhjgkghghjjkjhjkkggjkhjkhjjhhfhjhkjkhbbgjhzzzyyykkknnnNWBWHWPPNjjint}
in the Eulerian coordinates $(\vec x,t)$ and $\mathcal{E}(\vec x,t)$
is the finite strain tensor defined by
\er{hjgjgjgkllujkjjjkhhjhljjkjkljjljjint}. Then by
\er{hjgjgjgkllujkjjjkhhjhljjkjkljjljjlljjkhhjint} we easily deduce
that the Hooke's law
\er{MaxVacFull1ninshtrgravortghhghgjkgghklhjgkghghjjkjhjkkggjkhjkhjjhhfhjhkhjjkhjgzzzyyykkknnnNWBWHWPPNjjhhjhjkphkhkhhklklklint}
is invariant under the change of a non-inertial cartesian coordinate
system given by \er{jjjjkkkkint}, provided that, as usual, under
\er{jjjjkkkkint} we have:
\begin{equation}\label{hjgjgjgkllujkjkhkkkjklint}
\begin{cases}
\alpha'=\alpha,\\
\beta'=\beta,\\
\mathcal{T}'=A(t)\cdot \mathcal{T}\cdot A^T(t).
\end{cases}
\end{equation}
See the end of subsection \ref{hjgujhghfhffgfg} for the
generalization of the Hooke's law
\er{MaxVacFull1ninshtrgravortghhghgjkgghklhjgkghghjjkjhjkkggjkhjkhjjhhfhjhkhjjkhjgzzzyyykkknnnNWBWHWPPNjjhhjhjkphkhkhhklklklint}
to the case of anisotropic elastic bodies.

\subsubsection{The equations of the motion of the general continuum medium in Lagrangian coordinates}
The equation of the motion of the continuum in the cartesian
Eulerian coordinates $(\vec x,t)$ has the form
\er{MaxVacFull1ninshtrgravortghhghgjkgghklhjgkghghjjkjhjkkggjkhjkhjjhhfhjhkjkhbbgjhzzzyyykkknnnNWBWHWPPNjjint},
i.e:
\begin{multline}\label{MaxVacFull1ninshtrgravortghhghgjkgghklhjgkghghjjkjhjkkggjkhjkhjjhhfhjhkjkhbbgjhzzzyyykkknnnNWBWHWPPNjjjljkjkint}
\mu\left(\frac{\partial\vec u}{\partial t}+d_{\vec x}\vec u\cdot
\vec u\right)=-\mu\vec u\times curl_{\vec x}\vec
v+\mu\left(\partial_{t}\vec
v+\nabla_{\vec x}\frac{1}{2}|\vec v|^2\right)+\rho\vec E+\frac{1}{c}\vec j\times\vec B+\Div_{\vec x}\mathcal{T}=\\
=-\mu(\vec u-\vec v)\times curl_{\vec x}\vec v
+\mu\left(\partial_t\vec v+ d_{\vec x}\vec v\cdot\vec
v\right)+\rho\vec E+\frac{1}{c}\vec j\times\vec B+\Div_{\vec
x}\mathcal{T},
\end{multline}
and the continuum equation has the form
\er{MaxVacFull1ninshtrgravortghhghgjkgghklhjgkghghjjkjhjkkggjkhjkhjjhhfhjhkhjjkhjgzzzyyykkknnnNWBWHWPPNjjint},
i.e:
\begin{equation}\label{MaxVacFull1ninshtrgravortghhghgjkgghklhjgkghghjjkjhjkkggjkhjkhjjhhfhjhkhjjkhjgzzzyyykkknnnNWBWHWPPNjjjkkint}
\frac{\partial\mu}{\partial t}+div_{\vec x}\left(\mu\vec u\right)=0.
\end{equation}
We would like to get the analogues of these equations in the
Lagrangian coordinates $(\vec y,t)$. First of all we can obtain:
\begin{equation}\label{hjhjjttuttutukjhkhlkkjkhint}
\mu\left(\vec r(t,\vec y),t\right)=\mu(\vec y,t_0)\left|det\{d_{\vec
y}\vec r(t,\vec y)\}\right|^{-1}\quad\quad\forall t,\;\forall\vec
y\in\Omega.
\end{equation}
Then equality \er{hjhjjttuttutukjhkhlkkjkhint} substitutes the
continuum equation
\er{MaxVacFull1ninshtrgravortghhghgjkgghklhjgkghghjjkjhjkkggjkhjkhjjhhfhjhkhjjkhjgzzzyyykkknnnNWBWHWPPNjjjkkint}
in the Lagrangian coordinates $(\vec y,t)$ (see subsection
\ref{hjgyujtgjtututtu} for details). In particular, by
\er{hjgghhghhffhkkihiyjhjkhjhjhint},
\er{hjhjjttuttutukjhkhlkkjkhint} in Eulerian coordinates reeds as
\begin{equation}\label{hjhjjttuttutukjhkhlkkjkhjljjzzint}
\mu\left(\vec x,t\right)=\mu(\vec x,t_0)\left|det\{d_{\vec x}\vec
f(\vec x,t)\}\right|.
\end{equation}

Next, the following Calculus fact is well known in Continuum
Mechanics: if, given some matrix valued field
\begin{equation}\label{hkhkhhhnnmhhhjklklint}
\mathcal{T}(\vec x,t)\in\mathbb{R}^{3\times 3},
\end{equation}
we denote
\begin{equation}\label{Mgjgjghghklklint}
F(\vec y,t):=d_{\vec y}\vec r(t,\vec y)\in\mathbb{R}^{3\times 3},
\end{equation}
and
\begin{equation}\label{Mgjgjghghkjljlljljjjkjhhhjhint}
\mathcal{R}(\vec y,t):=\mathcal{T}\left(\vec r(t,\vec
y),t\right)\cdot \{F^{-1}(\vec y,t)\}^T\left(det\,F(\vec
y,t)\right),
\end{equation}
then we must have:
\begin{equation}\label{hjkgugfhfhffjkjkojljhjjgjjjggjint}
\Div_{\vec y}\mathcal{R}(\vec y,t)=\left( \Div_{\vec
x}\mathcal{T}\left(\vec r(t,\vec y),t\right)\right)\left(det\,F(\vec
y,t)\right).
\end{equation}
In particular, if $\mathcal{T}(\vec x,t)$ is the Cauchy stress
tensor, then $\mathcal{R}(\vec y,t)$ defined by
\er{Mgjgjghghkjljlljljjjkjhhhjhint} is called the first
Piola--Kirchhoff stress tensor. Then by
\er{hjkgugfhfhffjkjkojljhjjgjjjggjint},
\er{hjhjjttuttutukjhkhlkkjkhint}, \er{hjgghhghhffhkkint} and
\er{hjgghhghhffhkkmkljjnnnnoujkujjjjljjkhhjkjkint} we finally
rewrite
\er{MaxVacFull1ninshtrgravortghhghgjkgghklhjgkghghjjkjhjkkggjkhjkhjjhhfhjhkjkhbbgjhzzzyyykkknnnNWBWHWPPNjjjljkjkint}
in Lagrangian coordinates as:
\begin{multline}\label{MaxVacFull1ninshtrgravortghhghgjkgghklhjhgjggjgkghghjjkjhjkkggjkhjkhjjhhfhjhkjkhbbgjhzzzyyykkknnnNWBWHWPPNjjjljkjkint}
\mu(\vec y,t_0)\frac{\partial^2\vec r}{\partial t^2}(t,\vec
y)=\Div_{\vec y}\mathcal{R}(\vec y,t)\\-\mu(\vec
y,t_0)\frac{\partial\vec r}{\partial t}(t,\vec y)\times curl_{\vec
x}\vec v\left(\vec r(t,\vec y),t\right)+\mu(\vec
y,t_0)\left(\partial_{t}\vec v\left(\vec r(t,\vec
y),t\right)+\nabla_{\vec x}\frac{1}{2}\left|\vec v\left(\vec
r(t,\vec y),t\right)\right|^2\right)\\+\frac{\mu(\vec
y,t_0)}{\mu\left(\vec r(t,\vec y),t\right)}\left(\rho\left(\vec
r(t,\vec y),t\right)\vec E\left(\vec r(t,\vec
y),t\right)+\frac{1}{c}\vec j\left(\vec r(t,\vec
y),t\right)\times\vec B\left(\vec r(t,\vec y),t\right)\right).
\end{multline}

\subsubsection{Propagation of sound wave in the moving inviscid gas
or fluid}\label{jihggghjgjggint} Since in the case of inviscid fluid
or gas
\er{MaxVacFull1ninshtrgravortghhghgjkgghklhjgkghghjjkjhjkkggjkhjkhjjhhfhjhkhjjkhjgzzzyyykkknnnNWBWHWPPNjjhhjhjint}
holds:
\begin{equation}\label{MaxVacFull1ninshtrgra,mnvortghhghgjkgghklhjgkghghjjkjhjkkggjkhjkhjjhhfhjhkjkhbbgjhzzzyyykkknnnNWBWHWPPNjjhjhjhjhplllhjggjlint}
\mathcal{T}=-p I,
\end{equation}
consistently with
\er{MaxVacFull1ninshtrgravortghhghgjkgghklhjgkghghjjkjhjkkggjkhjkhjjhhfhjhkjkhbbgjhzzzyyykkknnnNWBWHWPPNjjint},
we rewrite the second Law of Newton for the our inviscid  continuum
medium with the inertial mass density $\mu$, the field of average
(macroscopic) velocities $\vec u$, the charge density $\rho$ and the
electric current density $\vec j$:
\begin{multline}\label{MaxVacFull1ninshtrgravortghhghgjkgghklhjgkghghjjkjhjkkggjkhjkhjjhhfhjhkjkhbbgjhzzzyyykkknnnNWBWHWPPNjjhuhhint}
\mu\left(\frac{\partial\vec u}{\partial t}+d_{\vec x}\vec u\cdot
\vec u\right)=\frac{\partial}{\partial t}(\mu\vec u)+\Div_{\vec
x}\left\{\mu\vec u\otimes\vec u\right\}=\\-\mu\vec u\times
curl_{\vec x}\vec v+\mu\left(\partial_{t}\vec
v+\nabla_{\vec x}\frac{1}{2}|\vec v|^2\right)+\rho\vec E+\frac{1}{c}\vec j\times\vec B-\nabla_{\vec x}p=\\
=-\mu(\vec u-\vec v)\times curl_{\vec x}\vec v
+\mu\left(\partial_t\vec v+ d_{\vec x}\vec v\cdot\vec
v\right)+\rho\vec E+\frac{1}{c}\vec j\times\vec B-\nabla_{\vec x}p,
\end{multline}
Here, as before, $\rho\vec E+\frac{1}{c}\vec j\times\vec B$ is the
volume density of the Lorentz force where $\vec E$ and $\vec B$ are
outer electric and magnetic fields, assumed to be changing smoothly
and almost constant in the microscopic level, $\vec v$ is a
vectorial gravitational potential also assumed to be changing
smoothly and almost constant in the microscopic level, and $p$ is
the pressure. Moreover, the mass density $\mu$, clearly satisfies
the continuum equation
\er{MaxVacFull1ninshtrgravortghhghgjkgghklhjgkghghjjkjhjkkggjkhjkhjjhhfhjhkhjjkhjgzzzyyykkknnnNWBWHWPPNjjint}:
\begin{equation}\label{MaxVacFull1ninshtrgravortghhghgjkgghklhjgkghghjjkjhjkkggjkhjkhjjhhfhjhkhjjkhjgzzzyyykkknnnNWBWHWPPNjj1int}
\frac{\partial\mu}{\partial t}+div_{\vec x}\left(\mu\vec u\right)=0.
\end{equation}
Next, by
\er{MaxVacFull1ninshtrgravortghhghgjkgghklhjgkghghjjkjhjkkggjkhjkhjjhhfhjhkjkhbbgjhzzzyyykkknnnNWBWHWPPNjjhjhjhjhinthjhjhjhint}
the First Law of Thermodynamics of this moving medium has the
following form:
\begin{multline}\label{MaxVacFull1ninshtrgravortghhghgjkgghklhjgkghghjjkjhjkkggjkhjkhjjhhfhjhkjkhbbgjhzzzyyykkknnnNWBWHWPPNjjhjhjhjhugyyyint}
\frac{\partial E}{\partial t}+\Div_{\vec x}\left\{E\vec
u\right\}=\mu\left(\frac{\partial}{\partial
t}\left(\frac{E}{\mu}\right)+\vec u\cdot\nabla_{\vec
x}\left(\frac{E}{\mu}\right)\right)\\ =-\left(\Div_{\vec x}\vec
u\right)p-\Div_{\vec x}\vec q+\left(\vec j-\rho\vec
u\right)\cdot\left(\vec E+\frac{1}{c}\vec u\times\vec B\right).
\end{multline}
Here $E$ is the volume density of the internal energy (energy per
unit volume) and consistently $\frac{E}{\mu}$ is the internal energy
per unit mass and $\vec q$ is the heat flux. Moreover, the internal
energy per unit mass is a function of the density $\mu$, pressure
$p$ and Kelvin temperature $T$:
\begin{equation}\label{MaxVacFull1ninshtrgravortghhghgjkgghklhjgkghghjjkjhjkkggjkhjkhjjhhfhjhkjkhbbgjhzzzyyykkknnnNWBWHWPPNjjhjhjhjhplllhjggjlhint}
\frac{E}{\mu}=U\left(\mu,p,T\right).
\end{equation}
There is also a state equation of the form:
\begin{equation}\label{MaxVacFull1ninsmljjlhtrgravortghhghgjkgghklhjgkghghjjkjhjkkggjkhjkhjjhhfhjhkjkhbbgjhzzzyyykkknnnNWBWHWPPNjjhjhjhjhplllhjggjlhint}
T=g\left(\mu,p\right).
\end{equation}
We remind that in the case of the simplest ideal gas
\er{MaxVacFull1ninshtrgravortghhghgjkgghklhjgkghghjjkjhjkkggjkhjkhjjhhfhjhkjkhbbgjhzzzyyykkknnnNWBWHWPPNjjhjhjhjhplllhjggjlhint}
takes particular form of \er{noninchredPPNbjjhjjkhkhkljjljljklint}:
\begin{equation}\label{noninchredPPNbjjhjjkhkhkljjljljklhjhjhjghhgghjhhjint}
\frac{E}{\mu}\,=\,\frac{\gamma_0}{m_0}\,\,k\,T,
\end{equation}
and
\er{MaxVacFull1ninsmljjlhtrgravortghhghgjkgghklhjgkghghjjkjhjkkggjkhjkhjjhhfhjhkjkhbbgjhzzzyyykkknnnNWBWHWPPNjjhjhjhjhplllhjggjlhint}
takes particular form of \er{noninchredPPNbjjhjjkhkhlkkkklint}:
\begin{equation}\label{noninchredPPNbjjhjjkhkhlkkkklhjgghghint}
T\,=\,\frac{m_0}{k}\,\frac{p}{\mu},
\end{equation}
where $m_0$ is the mass of the single molecule of the given gas and
$k$ is the Boltzmann constant and $\gamma_0>0$ is a constant that
depends on the kind of the gas (for the monatomic gas we have
$\gamma_0=\frac{3}{2}$). So, by inserting
\er{MaxVacFull1ninsmljjlhtrgravortghhghgjkgghklhjgkghghjjkjhjkkggjkhjkhjjhhfhjhkjkhbbgjhzzzyyykkknnnNWBWHWPPNjjhjhjhjhplllhjggjlhint}
into
\er{MaxVacFull1ninshtrgravortghhghgjkgghklhjgkghghjjkjhjkkggjkhjkhjjhhfhjhkjkhbbgjhzzzyyykkknnnNWBWHWPPNjjhjhjhjhplllhjggjlhint}
we have:
\begin{equation}\label{MaxVacFull1nkkkinshtrgravortghhghgjkgghklhjgkghghjjkjhjkkggjkhjkhjjhhfhjhkjkhbbgjhzzzyyykkknnhjhhnNWBWHWPPNjjhjhjhjhplllhjggjll;;lijiihint}
\frac{E}{\mu}
=U\left(\mu,p,g\left(\mu,p\right)\right):=F\left(\mu,p\right).
\end{equation}
In particular, in the case where
\er{noninchredPPNbjjhjjkhkhkljjljljklhjhjhjghhgghjhhjint} and
\er{noninchredPPNbjjhjjkhkhlkkkklhjgghghint} are valid we have
\begin{equation}\label{MaxVacFull1nkkkinshtrgravortghhghgjkgghklhjgkghghjjkjhjkkggjkhjkhjjhhfhjhkjkhbbgjhzzzyyykkknnhjhhnNWBWHWPPNjjhjhjhjhplllhjggjll;;lijiihkhhkgjgjint}
\frac{E}{\mu}=F\left(\mu,p\right):=\gamma_0\,\frac{p}{\mu}.
\end{equation}

Next we assume that
\begin{equation}\label{MaxVacFull1ninshtrgravortghhghgjkgghklhjgkghghjjkjhjkkggjkhjkhjjhhfhjhkjkhbbgjhzzzyyykkknnnNWBWHWPPNjjhjhjhjhplllhjggjlint}
\vec u=\vec u_0+\vec u_1,\quad\mu=\mu_0+\mu_1,\quad p=p_0+p_1.
\end{equation}
where $\vec u_0(\vec x,t),\mu_0(\vec x,t),p_0(\vec x,t)$ are the
averages of $\vec u(\vec x,t),\mu(\vec x,t),p(\vec x,t)$ on small
spatial and temporal intervals, surrounding the point $(\vec x,t)$.
Although we assume these intervals of space and time to be very
small, we also assume them to be quite \underline{macroscopic}. We
call $\vec u_1,\mu_1,p_1$ the oscillating parts of $\vec u,\mu,p$
(they characterize the sound wave) and we assume that they are small
with respect to the averages $\vec u_0,\mu_0,p_0$ i.e. we have:
\begin{equation}\label{MaxVacFull1yiyiyninshtrgravortghhghgjkgghklhjgkghghjjkjhjkkggjkhjkhjjhhfhjhkjkhbbgjhzzzyyykkknnnNWBWHWPPNjjhjhjhjhplllhjggjl;kint}
|\vec u_1|\ll|\vec u_0|,\quad|\mu_1|\ll |\mu_0|,\quad |p_1|\ll
|p_0|.
\end{equation}
However, we assume that $\vec u_1,\mu_1,p_1$ are highly oscillate
and thus they changes spatially and temporary much faster than the
averages  $\vec u_0,\mu_0,p_0$ and the fields $\vec v, \vec E,\vec
B,\rho,\vec j$ i.e. we have:
\begin{multline}\label{MaxVacFull1yiyiyninshtrgravortghhghgjkgghklhjghjhnkkghghjjkjhjkkggjkhjkhjjhhfhjhkjkhbbgjhzzzyyykkknnnNWBWHWPPNjjhjhjhjhplllhjggjl;kint}
\frac{\left|d_{\vec x}\left(\rho\vec E+\frac{1}{c}\vec j\times\vec
B\right)\right|}{\left|\rho\vec E+\frac{1}{c}\vec j\times\vec
B\right|}+\frac{|d_{\vec x}\vec v|}{|\vec v|}+\frac{|d_{\vec
x}\mu_0|}{|\mu_0|}+\frac{|d_{\vec x}p_0|}{|p_0|}+\frac{|d_{\vec
x}\vec u_0|}{|\vec
u_0|}\ll \min{\left\{\frac{|d_{\vec x}\vec u_1|}{|\vec u_1|},\frac{|d_{\vec x}\mu_1|}{|\mu_1|},\frac{|d_{\vec x}p_1|}{|p_1|}\right\}},\\
\quad\quad \frac{\left|\partial_t\left(\rho\vec E+\frac{1}{c}\vec
j\times\vec B\right)\right|}{\left|\rho\vec E+\frac{1}{c}\vec
j\times\vec B\right|}+\frac{|\partial_t\vec v|}{|\vec
v|}+\frac{|\partial_t\mu_0|}{|\mu_0|}+\frac{|\partial_t
p_0|}{|p_0|}+\frac{|\partial_t\vec u_0|}{|\vec u_0|}\ll
\min{\left\{\frac{|\partial_t\vec u_1|}{|\vec
u_1|},\frac{|\partial_t\mu_1|}{|\mu_1|},\frac{|\partial_t
p_1|}{|p_1|}\right\}},\\ \frac{|\vec u_1|}{|\vec u_0|}\ll
\min{\left\{\frac{|d_{\vec x}\vec u_1|}{|d_{\vec x}\vec
u_0|},\frac{|d_{\vec x}\vec u_1|}{|d_{\vec x}\vec v|}\right\}},
\quad\quad \frac{|\mu_1|}{|\mu_0|}+\frac{|p_1|}{|p_0|}\ll
\min{\left\{\frac{|d_{\vec x}p_1|}{|d_{\vec x}p_0|},\frac{|d_{\vec
x}\mu_1|}{|d_{\vec x}\mu_0|}\right\}}\\ \quad\quad\text{and}
\quad\quad \frac{|\mu_1|}{|\mu_0|}\ll\frac{|d_{\vec x}\vec u_1||\vec
u_0||\mu_0|}{\left|\rho\vec E+\frac{1}{c}\vec j\times\vec B\right|}.
\end{multline}
Finally, we assume that the fields $\vec v, \vec E,\vec B,\rho,\vec
j$  and $(\Div_{\vec x}\vec q)$ change slowly with respect to the
oscillations of  $\vec u_1,\mu_1,p_1$ and thus we assume that $\vec
v, \vec E,\vec B,\rho,\vec j$  and $(\Div_{\vec x}\vec q)$ can be
replaced by their spatial and temporal averages. Note that
$\mu,p,\mu_0,p_0,\mu_1,p_1$ behave like \underline{proper scalar}
fields and $\vec u,\vec u_0$ behave like \underline{speed-like
vector} fields under the change of cartesian coordinate systems.
Thus, since $\vec u_1=\vec u-\vec u_0$, we deduce that $\vec u_1$
behaves like a \underline{proper vector} field  under the change of
cartesian coordinate systems. Finally, note that obviously the
averages of $\vec u_1,\mu_1,p_1$ vanish.

Then in subsection \ref{jihggghjgjgg} we deduce
\begin{equation}\label{MaxVacFull1ninshtrgravortghhghgjkgghklhjgkghghjjkjhjkkggjkhjkhjjhhfhjhkjhjjhkhbbgjlmkmmhzzzyyykkknnnNWBWHWPPNjjlllkkkkkkljkkjjkjkhjjhhgghggggint}
\frac{\partial}{\partial t}\left(\frac{\partial\mu_1}{\partial
t}+\vec u_0\cdot\nabla_{\vec x}\mu_1\right)+\Div_{\vec
x}\left\{\left(\frac{\partial\mu_1}{\partial t}+\vec
u_0\cdot\nabla_{\vec x}\mu_1\right) \vec u_0\right\}
\approx\Delta_{\vec x} p_1,
\end{equation}
and
\begin{equation}\label{MaxVacFull1ninshtrgravortghhghgjkgghklhjgkghghjjkjhjkkggjkhjkhjjhhfhjhkjhjjhkhbbgjlmkmmhzzzyyykkknnnNWBWHWPPNjjlllkkkkkkljkkjjkjkint}
\mu_0\left(\frac{\partial\vec u_1}{\partial t}-curl_{\vec
x}\left\{\vec u_0\times\vec u_1\right\}+(\Div_{\vec x}\vec
u_1)\,\vec u_0\right) \approx-\nabla_{\vec x} p_1.
\end{equation}
Note that, as before, it can be easily proved that
\er{MaxVacFull1ninshtrgravortghhghgjkgghklhjgkghghjjkjhjkkggjkhjkhjjhhfhjhkjhjjhkhbbgjlmkmmhzzzyyykkknnnNWBWHWPPNjjlllkkkkkkljkkjjkjkhjjhhgghggggint}
and
\er{MaxVacFull1ninshtrgravortghhghgjkgghklhjgkghghjjkjhjkkggjkhjkhjjhhfhjhkjhjjhkhbbgjlmkmmhzzzyyykkknnnNWBWHWPPNjjlllkkkkkkljkkjjkjkint}
are invariant under the change of inertial or non-inertial cartesian
coordinate system. Thus,
\er{MaxVacFull1ninshtrgravortghhghgjkgghklhjgkghghjjkjhjkkggjkhjkhjjhhfhjhkjhjjhkhbbgjlmkmmhzzzyyykkknnnNWBWHWPPNjjlllkkkkkkljkkjjkjkhjjhhgghggggint}
and
\er{MaxVacFull1ninshtrgravortghhghgjkgghklhjgkghghjjkjhjkkggjkhjkhjjhhfhjhkjhjjhkhbbgjlmkmmhzzzyyykkknnnNWBWHWPPNjjlllkkkkkkljkkjjkjkint}
are still valid if
\er{MaxVacFull1yiyiyninshtrgravortghhghgjkgghklhjgkghghjjkjhjkkggjkhjkhjjhhfhjhkjkhbbgjhzzzyyykkknnnNWBWHWPPNjjhjhjhjhplllhjggjl;kint}
and
\er{MaxVacFull1yiyiyninshtrgravortghhghgjkgghklhjghjhnkkghghjjkjhjkkggjkhjkhjjhhfhjhkjkhbbgjhzzzyyykkknnnNWBWHWPPNjjhjhjhjhplllhjggjl;kint}
hold only in some specific (possibly artificial) inertial or
non-inertial cartesian coordinate system.

Next we proceed in two cases:
\begin{itemize}
\item[{\bf (i)}]
In the case of the simple models of inviscid gas.
\item[{\bf (ii)}]
In the case of inviscid barotropic fluid.
\end{itemize}
In the case of inviscid gas, in subsection \ref{jihggghjgjgg}  we
prove that
\begin{equation}\label{MaxVacFull1ninshtrgravortghhghgjkgghklhjnnmnmmngkghghjjkjhjkkggjkhjkhjjhhfhjhkjkhbbgjhzzzyyykkknnnNWBWHWPPNjjhjhjhjhghgjkhkhlkkhhkhkhhhkhhjgjggjkkhhkjhjjhjhjhjhjjggint}
\mu_1\approx\left(\frac{p_0}{\mu_0}-\mu_0\frac{\partial
F}{\partial\mu}\left(\mu_0,p_0\right)\right)^{-1}\mu_0\frac{\partial
F}{\partial p}\left(\mu_0,p_0\right)\,p_1.
\end{equation}
Therefore, inserting
\er{MaxVacFull1ninshtrgravortghhghgjkgghklhjnnmnmmngkghghjjkjhjkkggjkhjkhjjhhfhjhkjkhbbgjhzzzyyykkknnnNWBWHWPPNjjhjhjhjhghgjkhkhlkkhhkhkhhhkhhjgjggjkkhhkjhjjhjhjhjhjjggint}
into
\er{MaxVacFull1ninshtrgravortghhghgjkgghklhjgkghghjjkjhjkkggjkhjkhjjhhfhjhkjhjjhkhbbgjlmkmmhzzzyyykkknnnNWBWHWPPNjjlllkkkkkkljkkjjkjkhjjhhgghggggint}
and using again
\er{MaxVacFull1yiyiyninshtrgravortghhghgjkgghklhjghjhnkkghghjjkjhjkkggjkhjkhjjhhfhjhkjkhbbgjhzzzyyykkknnnNWBWHWPPNjjhjhjhjhplllhjggjl;kint}
we finally obtain the wave equation for the oscillating part of the
pressure $p_1$ of the form:
\begin{equation}\label{MaxVacFull1ninshtrgravortghhghgjkgghklhjgkghghjjkjhjkkggjkhjkhjjhhfhjhkjhjjhkhbbgjlmkmmhzzzyyykkknnnNWBWHWPPNjjlllkkkkkkljkkjjkjkhjjhhgghgggghjjhgjgghint}
\frac{\partial}{\partial
t}\left\{\frac{1}{c^2_0}\left(\frac{\partial p_1}{\partial t}+\vec
u_0\cdot\nabla_{\vec x}p_1\right)\right\}+\Div_{\vec
x}\left\{\frac{1}{c^2_0}\left(\frac{\partial p_1}{\partial t}+\vec
u_0\cdot\nabla_{\vec x}p_1\right) \vec u_0\right\}
\approx\Delta_{\vec x} p_1,
\end{equation}
where we denote
\begin{equation}\label{MaxVacFull1ninshtrgravortghhghgjkgghklhjgkghghjjkjhjkkggjkhjkhjjhhfhjhkjhjjhkhbbgjlmkmmhzzzyyykkknnnNWBWHWPPNjjlllkkkkkkljkkjjkjkhjjhhgghgggghjjhgjgghhkhjhjhint}
c_0:=\sqrt{\frac{\left(\frac{p_0}{\mu_0}-\mu_0\frac{\partial
F}{\partial\mu}\left(\mu_0,p_0\right)\right)}{\mu_0\frac{\partial
F}{\partial p}\left(\mu_0,p_0\right)}}.
\end{equation}
Moreover, the oscillating parts of the density $\mu_1$ and the
velocity $\vec u_1$ can be found from
\er{MaxVacFull1ninshtrgravortghhghgjkgghklhjnnmnmmngkghghjjkjhjkkggjkhjkhjjhhfhjhkjkhbbgjhzzzyyykkknnnNWBWHWPPNjjhjhjhjhghgjkhkhlkkhhkhkhhhkhhjgjggjkkhhkjhjjhjhjhjhjjggint}
and
\er{MaxVacFull1ninshtrgravortghhghgjkgghklhjgkghghjjkjhjkkggjkhjkhjjhhfhjhkjhjjhkhbbgjlmkmmhzzzyyykkknnnNWBWHWPPNjjlllkkkkkkljkkjjkjkint}
respectively, i.e. we have
\begin{equation}\label{MaxVacFull1ninshtrgravortghhghgjkgghklhjnnmnmmngkghghjjkjhjkkggjkhjkhjjhhfhjhkjkhbbgjhzzzyyykkknnnNWBWHWPPNjjhjhjhjhghgjkhkhlkkhhkhkhhhkhhjgjggjkkhhkjhjjhjhjhjhjjggijjhjint}
\mu_1\approx\left(\frac{p_0}{\mu_0}-\mu_0\frac{\partial
F}{\partial\mu}\left(\mu_0,p_0\right)\right)^{-1}\mu_0\frac{\partial
F}{\partial p}\left(\mu_0,p_0\right)\,p_1,
\end{equation}
and
\begin{equation}\label{MaxVacFull1ninshtrgravortghhghgjkgghklhjgkghghjjkjhjkkggjkhjkhjjhhfhjhkjhjjhkhbbgjlmkmmhzzzyyykkknnnNWBWHWPPNjjlllkkkkkkljkkjjkjk9uu9uuiiuuiiuint}
\mu_0\left(\frac{\partial\vec u_1}{\partial t}-curl_{\vec
x}\left\{\vec u_0\times\vec u_1\right\}+(\Div_{\vec x}\vec
u_1)\,\vec u_0\right) \approx-\nabla_{\vec x} p_1.
\end{equation}
Note that, as before, it can be easily proved that
\er{MaxVacFull1ninshtrgravortghhghgjkgghklhjgkghghjjkjhjkkggjkhjkhjjhhfhjhkjhjjhkhbbgjlmkmmhzzzyyykkknnnNWBWHWPPNjjlllkkkkkkljkkjjkjkhjjhhgghgggghjjhgjgghint},
\er{MaxVacFull1ninshtrgravortghhghgjkgghklhjnnmnmmngkghghjjkjhjkkggjkhjkhjjhhfhjhkjkhbbgjhzzzyyykkknnnNWBWHWPPNjjhjhjhjhghgjkhkhlkkhhkhkhhhkhhjgjggjkkhhkjhjjhjhjhjhjjggijjhjint}
and
\er{MaxVacFull1ninshtrgravortghhghgjkgghklhjgkghghjjkjhjkkggjkhjkhjjhhfhjhkjhjjhkhbbgjlmkmmhzzzyyykkknnnNWBWHWPPNjjlllkkkkkkljkkjjkjk9uu9uuiiuuiiuint}
are invariant under the change of inertial or non-inertial cartesian
coordinate system. Thus, as before,
\er{MaxVacFull1ninshtrgravortghhghgjkgghklhjgkghghjjkjhjkkggjkhjkhjjhhfhjhkjhjjhkhbbgjlmkmmhzzzyyykkknnnNWBWHWPPNjjlllkkkkkkljkkjjkjkhjjhhgghgggghjjhgjgghint},
\er{MaxVacFull1ninshtrgravortghhghgjkgghklhjnnmnmmngkghghjjkjhjkkggjkhjkhjjhhfhjhkjkhbbgjhzzzyyykkknnnNWBWHWPPNjjhjhjhjhghgjkhkhlkkhhkhkhhhkhhjgjggjkkhhkjhjjhjhjhjhjjggijjhjint}
and
\er{MaxVacFull1ninshtrgravortghhghgjkgghklhjgkghghjjkjhjkkggjkhjkhjjhhfhjhkjhjjhkhbbgjlmkmmhzzzyyykkknnnNWBWHWPPNjjlllkkkkkkljkkjjkjk9uu9uuiiuuiiuint}
are  still valid if
\er{MaxVacFull1yiyiyninshtrgravortghhghgjkgghklhjgkghghjjkjhjkkggjkhjkhjjhhfhjhkjkhbbgjhzzzyyykkknnnNWBWHWPPNjjhjhjhjhplllhjggjl;kint}
and
\er{MaxVacFull1yiyiyninshtrgravortghhghgjkgghklhjghjhnkkghghjjkjhjkkggjkhjkhjjhhfhjhkjkhbbgjhzzzyyykkknnnNWBWHWPPNjjhjhjhjhplllhjggjl;kint}
hold only in some specific (possibly artificial) inertial or
non-inertial cartesian coordinate system.

In particular, in the case of in the case of the simplest ideal gas
where \er{noninchredPPNbjjhjjkhkhkljjljljklhjhjhjghhgghjhhjint} and
\er{noninchredPPNbjjhjjkhkhlkkkklhjgghghint} are valid, by
\er{MaxVacFull1nkkkinshtrgravortghhghgjkgghklhjgkghghjjkjhjkkggjkhjkhjjhhfhjhkjkhbbgjhzzzyyykkknnhjhhnNWBWHWPPNjjhjhjhjhplllhjggjll;;lijiihkhhkgjgjint}
we have
\begin{equation}\label{MaxVacFull1nkkkinshtrgravortghhghgjkgghklhjgkghghjjkjhjkkggjkhjkhjjhhfhjhkjkhbbgjhzzzyyykkknnhjhhnNWBWHWPPNjjhjhjhjhplllhjggjll;;lijiihkhhkgjgjhjhjhbccint}
\frac{E}{\mu}=F\left(\mu,p\right):=\gamma_0\,\frac{p}{\mu},
\end{equation}
and therefore, inserting
\er{MaxVacFull1nkkkinshtrgravortghhghgjkgghklhjgkghghjjkjhjkkggjkhjkhjjhhfhjhkjkhbbgjhzzzyyykkknnhjhhnNWBWHWPPNjjhjhjhjhplllhjggjll;;lijiihkhhkgjgjhjhjhbccint}
into
\er{MaxVacFull1ninshtrgravortghhghgjkgghklhjgkghghjjkjhjkkggjkhjkhjjhhfhjhkjhjjhkhbbgjlmkmmhzzzyyykkknnnNWBWHWPPNjjlllkkkkkkljkkjjkjkhjjhhgghgggghjjhgjgghhkhjhjhint}
gives:
\begin{equation}\label{MaxVacFull1ninshtrgravortghhghgjkgghklhjgkghghjjkjhjkkggjkhjkhjjhhfhjhkjhjjhkhbbgjlmkmmhzzzyyykkknnnNWBWHWPPNjjlllkkkkkkljkkjjkjkhjjhhgghgggghjjhgjgghhkhjhjhjjggjjgjgjgint}
c_0:=\sqrt{\frac{\left(1+\gamma_0\right)p_0}{\gamma_0\mu_0}}.
\end{equation}
Moreover, in this case
\er{MaxVacFull1ninshtrgravortghhghgjkgghklhjnnmnmmngkghghjjkjhjkkggjkhjkhjjhhfhjhkjkhbbgjhzzzyyykkknnnNWBWHWPPNjjhjhjhjhghgjkhkhlkkhhkhkhhhkhhjgjggjkkhhkjhjjhjhjhjhjjggint}
reeds as:
\begin{equation}\label{MaxVacFull1ninshtrgravortghhghgjkgghklhjnnmnmmngkghghjjkjhjkkggjkhjkhjjhhfhjhkjkhbbgjhzzzyyykkknnnNWBWHWPPNjjhjhjhjhghgjkhkhlkkhhkhkhhhkhhjgjggjkkhhkjhjjhjhjhjhjjgghyuyhint}
\mu_1\approx\frac{\gamma_0}{1+\gamma_0}\,\frac{\mu_0}{p_0}\,p_1.
\end{equation}

On the other hand, in the case barotropic fluid the pressure $p$ is
a function of the density $\mu$ \underline{only}, i.e.
\begin{equation}\label{MaxVacFull1ninshtrgravortghhghgjkgghklhjnnmnmmngkghghjjkjhjkkggjkhjkhjjhhfhjhkjkhbbgjhzzzyyykkknnnNWBWHWPPNjjhjhjhjhghgjkhkhlkkhhjggggggggggggggggggggggggggggggint}
p=\mathcal{R}(\mu).
\end{equation}
Thus, inserting
\er{MaxVacFull1ninshtrgravortghhghgjkgghklhjgkghghjjkjhjkkggjkhjkhjjhhfhjhkjkhbbgjhzzzyyykkknnnNWBWHWPPNjjhjhjhjhplllhjggjlint}
into
\er{MaxVacFull1ninshtrgravortghhghgjkgghklhjnnmnmmngkghghjjkjhjkkggjkhjkhjjhhfhjhkjkhbbgjhzzzyyykkknnnNWBWHWPPNjjhjhjhjhghgjkhkhlkkhhjggggggggggggggggggggggggggggggint}
and using
\er{MaxVacFull1yiyiyninshtrgravortghhghgjkgghklhjgkghghjjkjhjkkggjkhjkhjjhhfhjhkjkhbbgjhzzzyyykkknnnNWBWHWPPNjjhjhjhjhplllhjggjl;kint}
we deduce
\begin{equation}\label{MaxVacFull1ninshtrgravortghhghgjkgghklhjnnmnmmngkghghjjkjhjkkggjkhjkhjjhhfhjhkjkhbbgjhzzzyyykkknnnNWBWHWPPNjjhjhjhjhghgjkhkhlkkhhjggggggggggggggggggggggggggggggjggjgint}
p_1\approx\frac{d\mathcal{R}}{d\mu}(\mu_0)\,\mu_1.
\end{equation}
Therefore, inserting
\er{MaxVacFull1ninshtrgravortghhghgjkgghklhjnnmnmmngkghghjjkjhjkkggjkhjkhjjhhfhjhkjkhbbgjhzzzyyykkknnnNWBWHWPPNjjhjhjhjhghgjkhkhlkkhhjggggggggggggggggggggggggggggggjggjgint}
into
\er{MaxVacFull1ninshtrgravortghhghgjkgghklhjgkghghjjkjhjkkggjkhjkhjjhhfhjhkjhjjhkhbbgjlmkmmhzzzyyykkknnnNWBWHWPPNjjlllkkkkkkljkkjjkjkhjjhhgghggggint}
and using again
\er{MaxVacFull1yiyiyninshtrgravortghhghgjkgghklhjghjhnkkghghjjkjhjkkggjkhjkhjjhhfhjhkjkhbbgjhzzzyyykkknnnNWBWHWPPNjjhjhjhjhplllhjggjl;kint}
we finally obtain the analogous to
\er{MaxVacFull1ninshtrgravortghhghgjkgghklhjgkghghjjkjhjkkggjkhjkhjjhhfhjhkjhjjhkhbbgjlmkmmhzzzyyykkknnnNWBWHWPPNjjlllkkkkkkljkkjjkjkhjjhhgghgggghjjhgjgghint}
wave equation for the oscillating part of the pressure $p_1$ of the
form:
\begin{equation}\label{MaxVacFull1ninshtrgravortghhghgjkgghklhjgkghghjjkjhjkkggjkhjkhjjhhfhjhkjhjjhkhbbgjlmkmmhzzzyyykkknnnNWBWHWPPNjjlllkkkkkkljkkjjkjkhjjhhgghgggghjjhgjgghjkjkjkint}
\frac{\partial}{\partial
t}\left\{\frac{1}{c^2_0}\left(\frac{\partial p_1}{\partial t}+\vec
u_0\cdot\nabla_{\vec x}p_1\right)\right\}+\Div_{\vec
x}\left\{\frac{1}{c^2_0}\left(\frac{\partial p_1}{\partial t}+\vec
u_0\cdot\nabla_{\vec x}p_1\right) \vec u_0\right\}
\approx\Delta_{\vec x} p_1,
\end{equation}
where we denote
\begin{equation}\label{MaxVacFull1ninshtrgravortghhghgjkgghklhjgkghghjjkjhjkkggjkhjkhjjhhfhjhkjhjjhkhbbgjlmkmmhzzzyyykkknnnNWBWHWPPNjjlllkkkkkkljkkjjkjkhjjhhgghgggghjjhgjgghhkhjhjhhkhhhhgint}
c_0:=\sqrt{\frac{d\mathcal{R}}{d\mu}(\mu_0)}.
\end{equation}
Moreover, the oscillating parts of the density $\mu_1$ and the
velocity $\vec u_1$ can be found from
\er{MaxVacFull1ninshtrgravortghhghgjkgghklhjnnmnmmngkghghjjkjhjkkggjkhjkhjjhhfhjhkjkhbbgjhzzzyyykkknnnNWBWHWPPNjjhjhjhjhghgjkhkhlkkhhjggggggggggggggggggggggggggggggjggjgint}
and
\er{MaxVacFull1ninshtrgravortghhghgjkgghklhjgkghghjjkjhjkkggjkhjkhjjhhfhjhkjhjjhkhbbgjlmkmmhzzzyyykkknnnNWBWHWPPNjjlllkkkkkkljkkjjkjkint}
respectively, i.e. we have
\begin{equation}\label{hihihhhooint}
p_1\approx\frac{d\mathcal{R}}{d\mu}(\mu_0)\,\mu_1,
\end{equation}
and
\begin{equation}\label{MaxVacFull1ninshtrgravortghhghgjkgghklhjgkghghjjkjhjkkggjkhjkhjjhhfhjhkjhjjhkhbbgjlmkmmhzzzyyykkknnnNWBWHWPPNjjlllkkkkkkljkkjjkjk9uu9uuiiuuiiujkhjhjjojjint}
\mu_0\left(\frac{\partial\vec u_1}{\partial t}-curl_{\vec
x}\left\{\vec u_0\times\vec u_1\right\}+(\Div_{\vec x}\vec
u_1)\,\vec u_0\right) \approx-\nabla_{\vec x} p_1.
\end{equation}
Again note that
\er{MaxVacFull1ninshtrgravortghhghgjkgghklhjgkghghjjkjhjkkggjkhjkhjjhhfhjhkjhjjhkhbbgjlmkmmhzzzyyykkknnnNWBWHWPPNjjlllkkkkkkljkkjjkjkhjjhhgghgggghjjhgjgghjkjkjkint},
\er{hihihhhooint} and
\er{MaxVacFull1ninshtrgravortghhghgjkgghklhjgkghghjjkjhjkkggjkhjkhjjhhfhjhkjhjjhkhbbgjlmkmmhzzzyyykkknnnNWBWHWPPNjjlllkkkkkkljkkjjkjk9uu9uuiiuuiiujkhjhjjojjint}
are invariant under the change of inertial or non-inertial cartesian
coordinate system. Thus, as before,
\er{MaxVacFull1ninshtrgravortghhghgjkgghklhjgkghghjjkjhjkkggjkhjkhjjhhfhjhkjhjjhkhbbgjlmkmmhzzzyyykkknnnNWBWHWPPNjjlllkkkkkkljkkjjkjkhjjhhgghgggghjjhgjgghjkjkjkint},
\er{hihihhhooint} and
\er{MaxVacFull1ninshtrgravortghhghgjkgghklhjgkghghjjkjhjkkggjkhjkhjjhhfhjhkjhjjhkhbbgjlmkmmhzzzyyykkknnnNWBWHWPPNjjlllkkkkkkljkkjjkjk9uu9uuiiuuiiujkhjhjjojjint}
are  still valid if
\er{MaxVacFull1yiyiyninshtrgravortghhghgjkgghklhjgkghghjjkjhjkkggjkhjkhjjhhfhjhkjkhbbgjhzzzyyykkknnnNWBWHWPPNjjhjhjhjhplllhjggjl;kint}
and
\er{MaxVacFull1yiyiyninshtrgravortghhghgjkgghklhjghjhnkkghghjjkjhjkkggjkhjkhjjhhfhjhkjkhbbgjhzzzyyykkknnnNWBWHWPPNjjhjhjhjhplllhjggjl;kint}
hold only in some specific (possibly artificial) inertial or
non-inertial cartesian coordinate system.

\subsubsection{Propagation of waves in the moving elastic
body}\label{gjgfhffhfhint} Consistently with the general equation of
motion of a continuum medium
\er{MaxVacFull1ninshtrgravortghhghgjkgghklhjgkghghjjkjhjkkggjkhjkhjjhhfhjhkjkhbbgjhzzzyyykkknnnNWBWHWPPNjjint},
consider the motion of an elastic body:
\begin{multline}\label{MaxVacFull1ninshtrgravortghhghgjkgghklhjgkghghjjkjhjkkggjkhjkhjjhhfhjhkjkhbbgjhzzzyyykkknnnNWBWHWPPNjjjljkjkkhhhhhhhhhhhhint}
\mu\left(\frac{\partial\vec u}{\partial t}+d_{\vec x}\vec u\cdot
\vec u\right)=\\-\mu\vec u\times curl_{\vec x}\vec
v+\mu\left(\partial_{t}\vec v+\nabla_{\vec x}\frac{1}{2}|\vec
v|^2\right)+\rho\vec E+\frac{1}{c}\vec j\times\vec B+\Div_{\vec
x}\left\{\alpha\mathcal{E}(\vec
x,t)+\beta\left(tr\left\{\mathcal{E}(\vec
x,t)\right\}\right)I\right\}.
\end{multline}
where
\begin{equation}\label{hjgjgjgkllujkjjjkhhjhljjkjkljjljjhhhint}
\mathcal{E}(\vec x,t):=\frac{1}{2}\left(I-\left\{d_{\vec x}\vec
f(\vec x,t)\right\}^T\cdot d_{\vec x}\vec f(\vec x,t)\right),
\end{equation}
consistently with \er{hjgjgjgkllujkjjjkhhjhljjkjkljjljjint} and
\er{MaxVacFull1ninshtrgravortghhghgjkgghklhjgkghghjjkjhjkkggjkhjkhjjhhfhjhkhjjkhjgzzzyyykkknnnNWBWHWPPNjjhhjhjkphkhkhhklklklint},
with
\begin{equation}\label{hjgghhghhffhkkmkljjnnnnghgghghint}
\begin{cases}
\frac{\partial\vec f}{\partial t}\left(\vec x,t\right)+d_{\vec
x}\vec f\left(\vec x,t\right)\cdot\vec u\left(\vec x,t\right)=0
\\
\vec f(\vec x,t_0)=\vec x,
\end{cases}
\end{equation}
consistently with \er{hjgghhghhffhkkmkljjnnnnint}, where the initial
instant of time $t_0$ is chosen (possibly fictitiously) in such a
way that our elastic body is relaxed at time $t_0$. Moreover, by
\er{hjhjjttuttutukjhkhlkkjkhjljjzzint}
we have
\begin{equation}\label{hjhjjttuttutukjhkhlkkjkhjljjint}
\mu\left(\vec x,t\right)=\mu(\vec x,t_0)\left|det\{d_{\vec x}\vec
f(\vec x,t)\}\right|.
\end{equation}

Next we assume that
\begin{multline}\label{MaxVacFull1ninshtrgravortghhghgjkgghklhjgkghghjjkjhjkkggjkhjkhjjhhfhjhkjkhbbgjhzzzyyykkknnnNWBWHWPPNjjhjhjhjhplllhjggjlccint}
\vec u(\vec x,t)=\vec u_0(\vec x,t)+\vec u_1(\vec x,t),\quad
\mu(\vec x,t)=\mu_0(\vec x,t)+\mu_1(\vec x,t),\\ \vec f(\vec
x,t)=\vec f_0(\vec x,t)+\vec f_1(\vec x,t)\quad\text{and}\quad
\mathcal{E}(\vec x,t)=\mathcal{E}_{0}(\vec x,t)+\mathcal{E}_{1}(\vec
x,t)
\end{multline}
where $\vec u_0(\vec x,t),\mu_0(\vec x,t),\vec f_0(\vec
x,t),\mathcal{E}_{0}(\vec x,t)$ are the averages of $\vec u(\vec
x,t),\mu(\vec x,t),\vec f(\vec x,t),\mathcal{E}(\vec x,t)$ on small
spatial and temporal intervals, surrounding the point $(\vec x,t)$.
Although we assume these intervals of space and time to be very
small, we also assume them to be quite \underline{macroscopic}. We
call $\vec u_1,\mu_1,\vec f_1,\mathcal{E}_{1}$ the oscillating parts
of $\vec u,\mu,\vec f,\mathcal{E}$ and we assume that they are small
with respect to the averages $\vec u_0,\mu_0,\vec f_0,\mathcal{E}_0$
i.e. we have:
\begin{equation}\label{MaxVacFull1yiyiyninshtrgravortghhghgjkgghklhjgkghghjjkjhjkkggjkhjkhjjhhfhjhkjkhbbgjhzzzyyykkknnnNWBWHWPPNjjhjhjhjhplllhjggjl;kccint}
|\vec u_1|\ll|\vec u_0|,\quad|\mu_1|\ll |\mu_0|,\quad|\vec f_1|\ll
|\vec f_0|.
\end{equation}
However, we assume that $\vec u_1,\mu_1,\vec f_1$ are highly
oscillate and thus they changes spatially and temporary much faster
than the averages  $\vec u_0,\mu_0,\vec f_0$ and the fields $\vec v,
\vec E,\vec B,\rho,\vec j$, i.e. we have:
\begin{multline}\label{MaxVacFull1yiyiyninshtrgravortghhghgjkgghklhjghjhnkkghghjjkjhjkkggjkhjkhjjhhfhjhkjkhbbgjhzzzyyykkknnnNWBWHWPPNjjhjhjhjhplllhjggjl;kccint}
\frac{|d_{\vec x}\alpha|}{|\alpha|}+\frac{|d_{\vec
x}\beta|}{|\beta|}+\frac{\left|d_{\vec x}\left(\rho\vec
E+\frac{1}{c}\vec j\times\vec B\right)\right|}{\left|\rho\vec
E+\frac{1}{c}\vec j\times\vec B\right|}+\frac{|d_{\vec x}\vec
v|}{|\vec v|}+\frac{|d_{\vec x}\mu_0|}{|\mu_0|}+\frac{|d_{\vec
x}\vec f_0|}{|\vec f_0|} +\frac{|d_{\vec x}\vec u_0|}{|\vec u_0|}
\ll \min{\left\{\frac{|d_{\vec x}\vec u_1|}{|\vec u_1|},\frac{|d_{\vec x}\mu_1|}{|\mu_1|},\frac{|d_{\vec x}\vec f_1|}{|\vec f_1|}\right\}},\\
\quad\quad
\frac{\left|\partial_t\left(\rho\vec E+\frac{1}{c}\vec j\times\vec
B\right)\right|}{\left|\rho\vec E+\frac{1}{c}\vec j\times\vec
B\right|}+\frac{|\partial_t\vec v|}{|\vec
v|}+\frac{|\partial_t\mu_0|}{|\mu_0|}+\frac{|\partial_t \vec
f_0|}{|\vec f_0|}+\frac{|\partial_t\vec u_0|}{|\vec u_0|}\ll
\min{\left\{\frac{|\partial_t\vec u_1|}{|\vec
u_1|},\frac{|\partial_t\mu_1|}{|\mu_1|},\frac{|\partial_t \vec
f_1|}{|\vec f_1|}\right\}},\\ \frac{|d^2_{\vec x}\vec f_0|}{|d_{\vec
x}\vec f_0|}\ll\frac{|d^2_{\vec x}\vec f_1|}{|d_{\vec x}\vec f_1|}
\quad\quad \frac{|\vec u_1|}{|\vec u_0|}\ll
\min{\left\{\frac{|d_{\vec x}\vec u_1|}{|d_{\vec x}\vec
u_0|},\frac{|d_{\vec x}\vec u_1|}{|d_{\vec x}\vec v|}\right\}},
\quad\quad\text{and} \quad\quad
\frac{|\mu_1|}{|\mu_0|}\ll\frac{|d_{\vec x}\vec u_1||\vec
u_0||\mu_0|}{\left|\rho\vec E+\frac{1}{c}\vec j\times\vec B\right|}.
\end{multline}
Finally, we assume that the fields $\vec v, \vec E,\vec B,\rho,\vec
j$ change slowly with respect to the oscillations of  $\vec
u_1,\mu_1,\vec f_1$ and thus we assume that $\vec v, \vec E,\vec
B,\rho,\vec j$ can be replaced by their spatial and temporal
averages. Note that $\mu,\mu_0,\mu_1$ behave like \underline{proper
scalar} fields, $\mathcal{E},\mathcal{E}_0,\mathcal{E}_1$ behave
like \underline{proper matrix} fields and $\vec u,\vec u_0$ behave
like \underline{speed-like vector}  fields under the change of
cartesian coordinate systems. Thus, since $\vec u_1=\vec u-\vec
u_0$, we deduce that $\vec u_1$ behaves like a \underline{proper
vector} field  under the change of cartesian coordinate systems.
Furthermore, using \er{hjgjgjgkllujkjjjkhhjhljjkint} we deduce that
the vector field $\vec g_1\left(\vec x,t\right)$ defined by the
following:
\begin{equation}\label{jgjgghghghhkhkhkhint}
\vec g_1\left(\vec x,t\right):=\left(d_{\vec x}\vec f_0\left(\vec
x,t\right)\right)^{-1}\cdot\vec f_1\left(\vec x,t\right),
\end{equation}
behaves like a \underline{proper vector} field. Finally, note that
obviously the averages of $\vec u_1,\mu_1,\vec f_1,\mathcal{E}_1$
vanish.

Then in subsection \ref{gjgfhffhfh} we deduce
\begin{multline}\label{MaxVacFull1ninshtrgravortghhghgjkgghklhjgkghghjjkjhjkkggjkhjkhjjhhfhjhkjkhbbgjhzzzyyykkknnnNWBWHWPPNjjjljkjkkhhhhhhhhhhhhfnfffhuyuyuyutjhjhjhggjjfgfhfhint}
\mu_0\left(\frac{\partial}{\partial t}\left(\frac{\partial\vec
g_1}{\partial t}+d_{\vec x}\vec g_1\cdot\vec u_0\right)+d_{\vec
x}\left(\frac{\partial\vec g_1}{\partial t}+d_{\vec x}\vec g_1\cdot\vec u_0\right)\cdot \vec u_0\right)\approx\\
-\alpha\,\Div_{\vec x}\left\{\mathcal{E}_0\cdot d_{\vec x}\vec
g_1\right\}-\alpha\,\Div_{\vec x}\left\{\left\{ d_{\vec x}\vec
g_1\right\}^T\cdot\mathcal{E}_0\right\}-\beta\,\nabla_{\vec
x}\left(tr\left(\mathcal{E}_0\cdot d_{\vec x}\vec
g_1\right)\right)-\beta\,\nabla_{\vec x}\left(tr\left(\left\{
d_{\vec x}\vec
g_1\right\}^T\cdot\mathcal{E}_0\right)\right)\\+\frac{\alpha}{2}\,\Delta_{\vec
x}\vec g_1+\left(\frac{\alpha}{2}+\beta\right)\nabla_{\vec
x}\left(\Div_{\vec x}\vec g_1\right),
\end{multline}
and
\begin{equation}\label{hjgghhghhffhkkmkljjnnnnghgghghjhyjyyjhhjhgjgjjhjhjhjhggjghjhnhmnmmbjoujhkhhhint}
\frac{\partial\vec g_1}{\partial t}-curl_{\vec x}\left\{\vec
u_0\times\vec g_1\right\}+\left(\Div_{\vec x}\vec g_1\right)\vec
u_0\approx-\vec u_1,
\end{equation}
where $\vec g_1(\vec x,t)$ is given by \er{jgjgghghghhkhkhkhint}.
Furthermore, we also derive
\begin{equation}\label{hjgjgjgkllujkjjjkhhjhljjkjkljjljjhhhjkhhuggint}
\mathcal{E}_0(\vec x,t)\approx\frac{1}{2}\left(I-\left\{d_{\vec
x}\vec f_0(\vec x,t)\right\}^T\cdot d_{\vec x}\vec f_0(\vec
x,t)\right),
\end{equation}
and
\begin{multline}\label{hjgjgjgkllujkjjjkhhjhljjkjkljjljjhkhjhhffgjjjljjlmbint}
\mathcal{E}_{1}(\vec x,t)\approx \mathcal{E}_0(\vec x,t)\cdot
d_{\vec x}\vec g_1(\vec x,t)+\left\{ d_{\vec x}\vec g_1(\vec
x,t)\right\}^T\cdot\mathcal{E}_0(\vec x,t)-\frac{1}{2}\left(d_{\vec
x}\vec g_1(\vec x,t)+\left\{ d_{\vec x}\vec g_1(\vec
x,t)\right\}^T\right).
\end{multline}
Moreover, we can derive the approximation of
\er{MaxVacFull1ninshtrgravortghhghgjkgghklhjgkghghjjkjhjkkggjkhjkhjjhhfhjhkjkhbbgjhzzzyyykkknnnNWBWHWPPNjjjljkjkkhhhhhhhhhhhhfnfffhuyuyuyutjhjhjhggjjfgfhfhint}
as
\begin{multline}\label{MaxVacFull1ninshtrgravortghhghgjkgghklhjgkghghjjkjhjkkggjkhjkhjjhhfhjhkjkhbbgjhzzzyyykkknnnNWBWHWPPNjjjljkjkkhhhhhhhhhhhhfnfffhuyuyuyutjhjhjhggjjfgfhfhkhhjhjhint}
\mu_0\frac{\partial}{\partial t}\left(\frac{\partial\vec
g_1}{\partial t}-curl_{\vec x}\left\{\vec u_0\times\vec
g_1\right\}+\left(\Div_{\vec x}\vec g_1\right)\vec
u_0\right)\\-\mu_0\,curl_{\vec x}\left\{\vec
u_0\times\left(\frac{\partial\vec g_1}{\partial t}-curl_{\vec
x}\left\{\vec u_0\times\vec g_1\right\}+\left(\Div_{\vec x}\vec
g_1\right)\vec u_0\right)\right\}\\+\left(\Div_{\vec
x}\left\{\frac{\partial\vec g_1}{\partial t}-curl_{\vec
x}\left\{\vec u_0\times\vec g_1\right\}+\left(\Div_{\vec x}\vec
g_1\right)\vec u_0\right\}\right)\vec u_0
\approx\frac{\alpha}{2}\,\Delta_{\vec x}\vec
g_1+\left(\frac{\alpha}{2}+\beta\right)\nabla_{\vec
x}\left(\Div_{\vec x}\vec g_1\right)\\
-\alpha\,\Div_{\vec x}\left\{\mathcal{E}_0\cdot d_{\vec x}\vec
g_1\right\}-\alpha\,\Div_{\vec x}\left\{\left\{ d_{\vec x}\vec
g_1\right\}^T\cdot\mathcal{E}_0\right\}-\beta\,\nabla_{\vec
x}\left(tr\left(\mathcal{E}_0\cdot d_{\vec x}\vec
g_1\right)\right)-\beta\,\nabla_{\vec x}\left(tr\left(\left\{
d_{\vec x}\vec g_1\right\}^T\cdot\mathcal{E}_0\right)\right).
\end{multline}
In, particular, if our elastic body is \underline{nearly rigid} we
have
\begin{equation}\label{hjgjgjgkllujkjjjkhhjhljjkjkljjljjhhhjkhhugghjjhjhgjint}
\mathcal{E}_0\approx 0,
\end{equation}
and we simplify
\er{MaxVacFull1ninshtrgravortghhghgjkgghklhjgkghghjjkjhjkkggjkhjkhjjhhfhjhkjkhbbgjhzzzyyykkknnnNWBWHWPPNjjjljkjkkhhhhhhhhhhhhfnfffhuyuyuyutjhjhjhggjjfgfhfhint}
as:
\begin{multline}\label{MaxVacFull1ninshtrgravortghhghgjkgghklhjgkghghjjkjhjkkggjkhjkhjjhhfhjhkjkhbbgjhzzzyyykkknnnNWBWHWPPNjjjljkjkkhhhhhhhhhhhhfnfffhuyuyuyutjhjhjhggjjfgfhfhjhint}
\mu_0\left(\frac{\partial}{\partial t}\left(\frac{\partial\vec
g_1}{\partial t}+d_{\vec x}\vec g_1\cdot\vec u_0\right)+d_{\vec
x}\left(\frac{\partial\vec g_1}{\partial t}+d_{\vec x}\vec
g_1\cdot\vec u_0\right)\cdot \vec u_0\right)
\approx\frac{\alpha}{2}\,\Delta_{\vec x}\vec
g_1+\left(\frac{\alpha}{2}+\beta\right)\nabla_{\vec
x}\left(\Div_{\vec x}\vec g_1\right),
\end{multline}
and \er{hjgjgjgkllujkjjjkhhjhljjkjkljjljjhkhjhhffgjjjljjlmbint} as:
\begin{equation}\label{hjgjgjgkllujkjjjkhhjhljjkjkljjljjhkhjhhffgjjjljjlmbhkhkhkhkint}
\mathcal{E}_{1}\approx-\frac{1}{2}\left(d_{\vec x}\vec g_1+\left\{
d_{\vec x}\vec g_1\right\}^T\right).
\end{equation}
Moreover, we rewrite
\er{MaxVacFull1ninshtrgravortghhghgjkgghklhjgkghghjjkjhjkkggjkhjkhjjhhfhjhkjkhbbgjhzzzyyykkknnnNWBWHWPPNjjjljkjkkhhhhhhhhhhhhfnfffhuyuyuyutjhjhjhggjjfgfhfhkhhjhjhint}
as:
\begin{multline}\label{MaxVacFull1ninshtrgravortghhghgjkgghklhjgkghghjjkjhjkkggjkhjkhjjhhfhjhkjkhbbgjhzzzyyykkknnnNWBWHWPPNjjjljkjkkhhhhhhhhhhhhfnfffhuyuyuyutjhjhjhggjjfgfhfhjhhgjgjint}
\mu_0\frac{\partial}{\partial t}\left(\frac{\partial\vec
g_1}{\partial t}-curl_{\vec x}\left\{\vec u_0\times\vec
g_1\right\}+\left(\Div_{\vec x}\vec g_1\right)\vec
u_0\right)\\-\mu_0\,curl_{\vec x}\left\{\vec
u_0\times\left(\frac{\partial\vec g_1}{\partial t}-curl_{\vec
x}\left\{\vec u_0\times\vec g_1\right\}+\left(\Div_{\vec x}\vec
g_1\right)\vec u_0\right)\right\}\\+\mu_0\left(\Div_{\vec
x}\left\{\frac{\partial\vec g_1}{\partial t}-curl_{\vec
x}\left\{\vec u_0\times\vec g_1\right\}+\left(\Div_{\vec x}\vec
g_1\right)\vec u_0\right\}\right)\vec u_0
\approx\frac{\alpha}{2}\,\Delta_{\vec x}\vec
g_1+\left(\frac{\alpha}{2}+\beta\right)\nabla_{\vec
x}\left(\Div_{\vec x}\vec g_1\right).
\end{multline}
Next, note that, since $\mu,\mu_0,\mu_1$ are \underline{proper
scalar} fields, $\mathcal{E},\mathcal{E}_0,\mathcal{E}_1$ are
\underline{proper matrix} fields, $\vec u,\vec u_0$ are
\underline{speed-like vector} fields, $\vec u_1$ is a
\underline{proper vector} field and the vector field $\vec
g_1\left(\vec x,t\right)$ defined by \er{jgjgghghghhkhkhkhint} is a
\underline{proper vector} field, as before, it can be easily proved
that
\er{MaxVacFull1ninshtrgravortghhghgjkgghklhjgkghghjjkjhjkkggjkhjkhjjhhfhjhkjkhbbgjhzzzyyykkknnnNWBWHWPPNjjjljkjkkhhhhhhhhhhhhfnfffhuyuyuyutjhjhjhggjjfgfhfhkhhjhjhint},
\er{hjgghhghhffhkkmkljjnnnnghgghghjhyjyyjhhjhgjgjjhjhjhjhggjghjhnhmnmmbjoujhkhhhint},
\er{hjgjgjgkllujkjjjkhhjhljjkjkljjljjhhhjkhhuggint},
\er{hjgjgjgkllujkjjjkhhjhljjkjkljjljjhkhjhhffgjjjljjlmbint},
\er{MaxVacFull1ninshtrgravortghhghgjkgghklhjgkghghjjkjhjkkggjkhjkhjjhhfhjhkjkhbbgjhzzzyyykkknnnNWBWHWPPNjjjljkjkkhhhhhhhhhhhhfnfffhuyuyuyutjhjhjhggjjfgfhfhjhhgjgjint}
\er{hjgjgjgkllujkjjjkhhjhljjkjkljjljjhhhjkhhugghjjhjhgjint} and
\er{hjgjgjgkllujkjjjkhhjhljjkjkljjljjhkhjhhffgjjjljjlmbhkhkhkhkint}
are invariant under the change of inertial or non-inertial cartesian
coordinate system. Thus,
\er{MaxVacFull1ninshtrgravortghhghgjkgghklhjgkghghjjkjhjkkggjkhjkhjjhhfhjhkjkhbbgjhzzzyyykkknnnNWBWHWPPNjjjljkjkkhhhhhhhhhhhhfnfffhuyuyuyutjhjhjhggjjfgfhfhkhhjhjhint},
\er{hjgghhghhffhkkmkljjnnnnghgghghjhyjyyjhhjhgjgjjhjhjhjhggjghjhnhmnmmbjoujhkhhhint},
\er{hjgjgjgkllujkjjjkhhjhljjkjkljjljjhhhjkhhuggint},
\er{hjgjgjgkllujkjjjkhhjhljjkjkljjljjhkhjhhffgjjjljjlmbint} and in
the case of \underline{nearly rigid} elastic body also
\er{MaxVacFull1ninshtrgravortghhghgjkgghklhjgkghghjjkjhjkkggjkhjkhjjhhfhjhkjkhbbgjhzzzyyykkknnnNWBWHWPPNjjjljkjkkhhhhhhhhhhhhfnfffhuyuyuyutjhjhjhggjjfgfhfhjhhgjgjint}
\er{hjgjgjgkllujkjjjkhhjhljjkjkljjljjhhhjkhhugghjjhjhgjint} and
\er{hjgjgjgkllujkjjjkhhjhljjkjkljjljjhkhjhhffgjjjljjlmbhkhkhkhkint}
are still valid if
\er{MaxVacFull1yiyiyninshtrgravortghhghgjkgghklhjgkghghjjkjhjkkggjkhjkhjjhhfhjhkjkhbbgjhzzzyyykkknnnNWBWHWPPNjjhjhjhjhplllhjggjl;kccint}
and
\er{MaxVacFull1yiyiyninshtrgravortghhghgjkgghklhjghjhnkkghghjjkjhjkkggjkhjkhjjhhfhjhkjkhbbgjhzzzyyykkknnnNWBWHWPPNjjhjhjhjhplllhjggjl;kccint}
hold only in some specific (possibly artificial) inertial or
non-inertial cartesian coordinate system.

Next, taking the divergence of both sides of
\er{MaxVacFull1ninshtrgravortghhghgjkgghklhjgkghghjjkjhjkkggjkhjkhjjhhfhjhkjkhbbgjhzzzyyykkknnnNWBWHWPPNjjjljkjkkhhhhhhhhhhhhfnfffhuyuyuyutjhjhjhggjjfgfhfhjhhgjgjint}
and using
\er{MaxVacFull1yiyiyninshtrgravortghhghgjkgghklhjghjhnkkghghjjkjhjkkggjkhjkhjjhhfhjhkjkhbbgjhzzzyyykkknnnNWBWHWPPNjjhjhjhjhplllhjggjl;kccint}
gives:
\begin{equation}\label{MaxVacFull1ninshtrgravortghhghgjkgghklhjgkghghjjkjhjkkggjkhjkhjjhhfhjhkjkhbbgjhzzzyyykkknnnNWBWHWPPNjjjljkjkkhhhhhhhhhhhhfnfffhuyuyuyutjhjhjhggjjfgfhfhjhhgjgjhkjhjhkhkhint}
\frac{\partial}{\partial
t}\left\{\frac{1}{c^2_{01}}\left(\frac{\partial h_1}{\partial
t}+\vec u_0\cdot\nabla_{\vec x} h_1\right)\right\}+\Div_{\vec
x}\left\{\frac{1}{c^2_{01}}\left(\frac{\partial h_1}{\partial
t}+\vec u_0\cdot\nabla_{\vec x} h_1\right)\vec u_0\right\}
\approx\Delta_{\vec x} h_1,
\end{equation}
where
\begin{equation}\label{hjgjgjgkllujkjjjkhhjhljjkjkljjljjhhhjkhhugghjjhjhgjkhhhkhint}
h_1(\vec x,t):=\Div_{\vec x}\vec g_1(\vec
x,t)\quad\quad\text{and}\quad\quad
c_{01}:=\sqrt{\frac{\left(\alpha+\beta\right)}{\mu_0}}.
\end{equation}
On the other hand taking the curl of both sides of
\er{MaxVacFull1ninshtrgravortghhghgjkgghklhjgkghghjjkjhjkkggjkhjkhjjhhfhjhkjkhbbgjhzzzyyykkknnnNWBWHWPPNjjjljkjkkhhhhhhhhhhhhfnfffhuyuyuyutjhjhjhggjjfgfhfhjhint}
and using
\er{MaxVacFull1yiyiyninshtrgravortghhghgjkgghklhjghjhnkkghghjjkjhjkkggjkhjkhjjhhfhjhkjkhbbgjhzzzyyykkknnnNWBWHWPPNjjhjhjhjhplllhjggjl;kccint}
gives:
\begin{equation}\label{MaxVacFull1ninshtrgravortghhghgjkgghklhjgkghghjjkjhjkkggjkhjkhjjhhfhjhkjkhbbgjhzzzyyykkknnnNWBWHWPPNjjjljkjkkhhhhhhhhhhhhfnfffhuyuyuyutjhjhjhggjjfgfhfhjhgjgjkhkhgnnjkhint}
\frac{\partial}{\partial
t}\left\{\frac{1}{c^2_{02}}\left(\frac{\partial\vec h_2}{\partial
t}+d_{\vec x}\vec h_2\cdot\vec u_0\right)\right\}+\Div_{\vec
x}\left\{\frac{1}{c^2_{02}}\left(\frac{\partial\vec h_2}{\partial
t}+d_{\vec x}\vec h_2\cdot\vec u_0\right)\otimes\vec u_0\right\}
\approx\Delta_{\vec x}\vec h_2,
\end{equation}
where
\begin{equation}\label{hjgjgjgkllujkjjjkhhjhljjkjkljjljjhhhjkhhugghjjhjhgjkhhhkhhjjghjkhjhint}
\vec h_2(\vec x,t):=curl_{\vec x}\,\vec g_1(\vec
x,t)\quad\quad\text{and}\quad\quad
c_{02}:=\sqrt{\frac{\alpha}{2\mu_0}}.
\end{equation}
Again, using
\er{MaxVacFull1yiyiyninshtrgravortghhghgjkgghklhjghjhnkkghghjjkjhjkkggjkhjkhjjhhfhjhkjkhbbgjhzzzyyykkknnnNWBWHWPPNjjhjhjhjhplllhjggjl;kccint}
we can rewrite
\er{MaxVacFull1ninshtrgravortghhghgjkgghklhjgkghghjjkjhjkkggjkhjkhjjhhfhjhkjkhbbgjhzzzyyykkknnnNWBWHWPPNjjjljkjkkhhhhhhhhhhhhfnfffhuyuyuyutjhjhjhggjjfgfhfhjhgjgjkhkhgnnjkhint}
as:
\begin{multline}\label{MaxVacFullPPNnnnffffffyuughjhjhjhhjjkjhkkjhujgGGggffint}
\frac{1}{c^2_{02}}\frac{\partial}{\partial
t}\left(\frac{\partial\vec h_2}{\partial t}-\vec { u}_0\times
curl_{\vec x}\vec h_2+\nabla_{\vec x}\left(\vec h_2\cdot\vec {
u}_0\right)\right)\\-\frac{1}{c^2_{02}}curl_{\vec x} \left\{\vec
{u}_0\times
\left(\frac{\partial\vec h_2}{\partial t}-\vec {u}_0\times
curl_{\vec x}\vec h_2+\nabla_{\vec x}\left(\vec h_2\cdot\vec {
u}_0\right)\right)\right\}\\+\frac{1}{c^2_{02}}\left(\Div_{\vec
x}\left(\frac{\partial\vec h_2}{\partial t}-\vec {u}_0\times
curl_{\vec x}\vec h_2+\nabla_{\vec x}\left(\vec h_2\cdot\vec {
u}_0\right)\right)\right)\vec {u}_0\approx \Delta_{\vec x}\vec h_2.
\end{multline}
Next we deduce that equations
\er{MaxVacFull1ninshtrgravortghhghgjkgghklhjgkghghjjkjhjkkggjkhjkhjjhhfhjhkjkhbbgjhzzzyyykkknnnNWBWHWPPNjjjljkjkkhhhhhhhhhhhhfnfffhuyuyuyutjhjhjhggjjfgfhfhjhhgjgjhkjhjhkhkhint}
and \er{MaxVacFullPPNnnnffffffyuughjhjhjhhjjkjhkkjhujgGGggffint} are
invariant under the change of inertial or non-inertial cartesian
coordinate system, provided that $\vec h_2$ is a proper vector field
and $h_1$ is a proper scalar field. Thus, in the case of
\underline{nearly rigid} elastic body
\er{MaxVacFull1ninshtrgravortghhghgjkgghklhjgkghghjjkjhjkkggjkhjkhjjhhfhjhkjkhbbgjhzzzyyykkknnnNWBWHWPPNjjjljkjkkhhhhhhhhhhhhfnfffhuyuyuyutjhjhjhggjjfgfhfhjhhgjgjhkjhjhkhkhint}
and \er{MaxVacFullPPNnnnffffffyuughjhjhjhhjjkjhkkjhujgGGggffint} are
still valid if
\er{MaxVacFull1yiyiyninshtrgravortghhghgjkgghklhjgkghghjjkjhjkkggjkhjkhjjhhfhjhkjkhbbgjhzzzyyykkknnnNWBWHWPPNjjhjhjhjhplllhjggjl;kccint}
and
\er{MaxVacFull1yiyiyninshtrgravortghhghgjkgghklhjghjhnkkghghjjkjhjkkggjkhjkhjjhhfhjhkjkhbbgjhzzzyyykkknnnNWBWHWPPNjjhjhjhjhplllhjggjl;kccint}
hold only in some specific (possibly artificial) inertial or
non-inertial cartesian coordinate system. Finally note that,
although the equation
\er{MaxVacFull1ninshtrgravortghhghgjkgghklhjgkghghjjkjhjkkggjkhjkhjjhhfhjhkjkhbbgjhzzzyyykkknnnNWBWHWPPNjjjljkjkkhhhhhhhhhhhhfnfffhuyuyuyutjhjhjhggjjfgfhfhjhgjgjkhkhgnnjkhint}
is invariant under the Galilean Transformation, it is not invariant
under the more general change of non-inertial cartesian coordinate
system. However,
\er{MaxVacFull1ninshtrgravortghhghgjkgghklhjgkghghjjkjhjkkggjkhjkhjjhhfhjhkjkhbbgjhzzzyyykkknnnNWBWHWPPNjjjljkjkkhhhhhhhhhhhhfnfffhuyuyuyutjhjhjhggjjfgfhfhjhgjgjkhkhgnnjkhint}
is more convenient than
\er{MaxVacFullPPNnnnffffffyuughjhjhjhhjjkjhkkjhujgGGggffint}, since
every of the three scalar components of the vector field $\vec h_2$
in
\er{MaxVacFull1ninshtrgravortghhghgjkgghklhjgkghghjjkjhjkkggjkhjkhjjhhfhjhkjkhbbgjhzzzyyykkknnnNWBWHWPPNjjjljkjkkhhhhhhhhhhhhfnfffhuyuyuyutjhjhjhggjjfgfhfhjhgjgjkhkhgnnjkhint}
satisfies three decoupled wave equations of the same type. On the
other hand, if we consider some three proper vector fields $\vec
e_1:=\vec e_1(\vec x,t)$, $\vec e_2:=\vec e_2(\vec x,t)$, and $\vec
e_3:=\vec e_3(\vec x,t)$, which are mutually orthogonal to each
other and satisfy the following approximation analogous to
\er{MaxVacFull1yiyiyninshtrgravortghhghgjkgghklhjghjhnkkghghjjkjhjkkggjkhjkhjjhhfhjhkjkhbbgjhzzzyyykkknnnNWBWHWPPNjjhjhjhjhplllhjggjl;kccint}:
\begin{equation}\label{MaxVacFullPPNmmmffffffhhtygghGGGGyuhggghghhjhjhffint}
\frac{|d_{\vec x}\vec {e}_1|+c_{02}|\partial_t\vec e_1|}{|\vec
{e}_1|}+\frac{|d_{\vec x}\vec {e}_2|+c_{02}|\partial_t\vec
e_2|}{|\vec {e}_2|}+\frac{|d_{\vec x}\vec
{e}_3|+c_{02}|\partial_t\vec e_3|}{|\vec {e}_3|}\ll\frac{|d_{\vec
x}\vec h_2|+c_{02}|\partial_t\vec h_2|}{|\vec h_2|}.
\end{equation}
i.e. the field $\vec {e}_k$ vary in space and time much weaker than
$\vec h_2$, then we may write the alternative to
\er{MaxVacFull1ninshtrgravortghhghgjkgghklhjgkghghjjkjhjkkggjkhjkhjjhhfhjhkjkhbbgjhzzzyyykkknnnNWBWHWPPNjjjljkjkkhhhhhhhhhhhhfnfffhuyuyuyutjhjhjhggjjfgfhfhjhgjgjkhkhgnnjkhint}
approximate equations in the form of three decoupled scalar wave
equations of the same type:
\begin{multline}\label{MaxVacFullPPNmmmffffffiuiuhjuGGggFGjjkkjjffint}
\frac{\partial}{\partial
t}\left\{\frac{1}{c^2_{02}}\left(\frac{\partial}{\partial
t}\left(\vec e_k\cdot\vec h_2\right)+\vec {u}_0\cdot\nabla_{\vec
x}\left(\vec e_k\cdot\vec h_2\right)\right)\right\}+\Div_{\vec x}
\left\{\frac{1}{c^2_{02}}\left(\frac{\partial}{\partial t}\left(\vec
e_k\cdot\vec h_2\right)+\vec {u}_0\cdot\nabla_{\vec x}\left(\vec
e_k\cdot\vec h_2\right)\right)\vec {u}_0\right\}\\ \approx
\Delta_{\vec x}\left(\vec e_k\cdot\vec h_2\right)\quad\quad\forall
k=1,2,3.
\end{multline}
Then, clearly, the new alternative approximate equations
\er{MaxVacFullPPNmmmffffffiuiuhjuGGggFGjjkkjjffint} together with
\er{MaxVacFull1ninshtrgravortghhghgjkgghklhjgkghghjjkjhjkkggjkhjkhjjhhfhjhkjkhbbgjhzzzyyykkknnnNWBWHWPPNjjjljkjkkhhhhhhhhhhhhfnfffhuyuyuyutjhjhjhggjjfgfhfhjhhgjgjhkjhjhkhkhint}
are indeed invariant under the more general change of non-inertial
cartesian coordinate system.

As a final corollary of the above in the case of nearly rigid body
we have \er{MaxVacFullPPNmmmffffffiuiuhjuGGggFGjjkkjjffint} and
\er{MaxVacFull1ninshtrgravortghhghgjkgghklhjgkghghjjkjhjkkggjkhjkhjjhhfhjhkjkhbbgjhzzzyyykkknnnNWBWHWPPNjjjljkjkkhhhhhhhhhhhhfnfffhuyuyuyutjhjhjhggjjfgfhfhjhhgjgjhkjhjhkhkhint},
i.e.:
\begin{equation}\label{MaxVacFull1ninshtrgravortghhghgjkgghklhjgkghghjjkjhjkkggjkhjkhjjhhfhjhkjkhbbgjhzzzyyykkknnnNWBWHWPPNjjjljkjkkhhhhhhhhhhhhfnfffhuyuyuyutjhjhjhggjjfgfhfhjhhgjgjhkjhjhkhkhjjint}
\frac{\partial}{\partial
t}\left\{\frac{1}{c^2_{01}}\left(\frac{\partial h_1}{\partial
t}+\vec u_0\cdot\nabla_{\vec x} h_1\right)\right\}+\Div_{\vec
x}\left\{\frac{1}{c^2_{01}}\left(\frac{\partial h_1}{\partial
t}+\vec u_0\cdot\nabla_{\vec x} h_1\right)\vec u_0\right\}
\approx\Delta_{\vec x} h_1,
\end{equation}
and
\begin{multline}\label{MaxVacFullPPNmmmffffffiuiuhjuGGggFGjjkkjjffjjint}
\frac{\partial}{\partial
t}\left\{\frac{1}{c^2_{02}}\left(\frac{\partial}{\partial
t}\left(\vec e_k\cdot\vec h_2\right)+\vec {u}_0\cdot\nabla_{\vec
x}\left(\vec e_k\cdot\vec h_2\right)\right)\right\}+\Div_{\vec x}
\left\{\frac{1}{c^2_{02}}\left(\frac{\partial}{\partial t}\left(\vec
e_k\cdot\vec h_2\right)+\vec {u}_0\cdot\nabla_{\vec x}\left(\vec
e_k\cdot\vec h_2\right)\right)\vec {u}_0\right\}\\ \approx
\Delta_{\vec x}\left(\vec e_k\cdot\vec h_2\right)\quad\quad\forall
k=1,2,3,
\end{multline}
where
\begin{equation}\label{hjgjgjgkllujkjjjkhhjhljjkjkljjljjhhhjkhhugghjjhjhgjkhhhkhjkhhkhkkhint}
h_1(\vec x,t):=\Div_{\vec x}\vec g_1(\vec
x,t)\quad\quad\text{and}\quad\quad
c_{01}:=\sqrt{\frac{\left(\alpha+\beta\right)}{\mu_0}}.
\end{equation}
and
\begin{equation}\label{hjgjgjgkllujkjjjkhhjhljjkjkljjljjhhhjkhhugghjjhjhgjkhhhkhhjjghjkhjhjkjkjkkint}
\vec h_2(\vec x,t):=curl_{\vec x}\,\vec g_1(\vec
x,t)\quad\quad\text{and}\quad\quad
c_{02}:=\sqrt{\frac{\alpha}{2\mu_0}}.
\end{equation}
Moreover, we have
\er{hjgjgjgkllujkjjjkhhjhljjkjkljjljjhkhjhhffgjjjljjlmbhkhkhkhkint},
i.e.:
\begin{equation}\label{hjgjgjgkllujkjjjkhhjhljjkjkljjljjhkhjhhffgjjjljjlmbhkhkhkhkjkkhhhint}
\mathcal{E}_{1}\approx-\frac{1}{2}\left(d_{\vec x}\vec g_1+\left\{
d_{\vec x}\vec g_1\right\}^T\right),
\end{equation}
and
\er{hjgghhghhffhkkmkljjnnnnghgghghjhyjyyjhhjhgjgjjhjhjhjhggjghjhnhmnmmbjoujhkhhhint},
i.e.:
\begin{equation}\label{hjgghhghhffhkkmkljjnnnnghgghghjhyjyyjhhjhgjgjjhjhjhjhggjghjhnhmnmmbjoujhkggghghint}
\frac{\partial\vec g_1}{\partial t}-curl_{\vec x}\left\{\vec
u_0\times\vec g_1\right\}+\left(\Div_{\vec x}\vec g_1\right)\vec
u_0\approx-\vec u_1,
\end{equation}
and the above six equations are invariant under the change of
inertial or non-inertial cartesian coordinate system.

As we can easily see, the divergence of $\vec g_1$ and any of the
scalar components of the curl of $\vec g_1$ satisfy the invariant
wave equation of the same type as
\er{MaxVacFull1ninshtrgravortghhghgjkgghklhjgkghghjjkjhjkkggjkhjkhjjhhfhjhkjhjjhkhbbgjlmkmmhzzzyyykkknnnNWBWHWPPNjjlllkkkkkkljkkjjkjkhjjhhgghgggghjjhgjgghint}
or
\er{MaxVacFull1ninshtrgravortghhghgjkgghklhjgkghghjjkjhjkkggjkhjkhjjhhfhjhkjhjjhkhbbgjlmkmmhzzzyyykkknnnNWBWHWPPNjjlllkkkkkkljkkjjkjkhjjhhgghgggghjjhgjgghjkjkjkint}.
However, as we can see from
\er{hjgjgjgkllujkjjjkhhjhljjkjkljjljjhhhjkhhugghjjhjhgjkhhhkhjkhhkhkkhint}
and
\er{hjgjgjgkllujkjjjkhhjhljjkjkljjljjhhhjkhhugghjjhjhgjkhhhkhhjjghjkhjhjkjkjkkint}
the characteristic parameter $c_{01}$ in the wave equation for the
divergence part of $\vec g_1$
\er{MaxVacFull1ninshtrgravortghhghgjkgghklhjgkghghjjkjhjkkggjkhjkhjjhhfhjhkjkhbbgjhzzzyyykkknnnNWBWHWPPNjjjljkjkkhhhhhhhhhhhhfnfffhuyuyuyutjhjhjhggjjfgfhfhjhhgjgjhkjhjhkhkhjjint}
differ from the characteristic parameter $c_{02}$ in the wave
equation for the curl part of $\vec g_1$
\er{MaxVacFullPPNmmmffffffiuiuhjuGGggFGjjkkjjffjjint}, i.e. the
divergence part and the curl parts of  $\vec g_1$ propagate as two
different waves with the different speeds. As it can be easily seen,
in the case of the flat waves in the resting body the divergence
part of $\vec g_1$ (which is curl-free) propagate as a
\underline{longitudinal} wave, similarly to the sound in fluid or
gas, and at the same time the curl part of $\vec g_1$ (which is
divergence-free) propagate as a \underline{transverse} wave.
Moreover, the longitudinal and transverse waves in an elastic body
propagate with two different speeds.

\subsection{Macroscopic Electrodynamics in the presence of dielectric and/or magnetic
mediums} Consider system  \er{MaxVacFull1bjkgjhjhgjaaaint}
in some inertial or non-inertial cartesian coordinate system inside
a dielectric and/or magnetic medium:
\begin{equation}\label{MaxVacFullnnnnGGint}
\begin{cases}
curl_{\vec x} \vec H_0= \frac{4\pi}{c}\left(\vec j+\vec
j_{mp}\right)+ \frac{1}{c}\frac{\partial \vec D_0}{\partial t}
\\
div_{\vec x} \vec D_0= 4\pi\left(\rho+\rho_p\right)
\\
curl_{\vec x} \vec E+\frac{1}{c}\frac{\partial \vec B}{\partial t}=0
\\
div_{\vec x} \vec B= 0,
\end{cases}
\end{equation}
where $\vec E$ is the electric field, $\vec B$ is the magnetic
field, $\vec v:=\vec v(\vec x,t)$ is the vectorial gravitational
potential, $\rho$ is the average (macroscopic) charge density,
$\rho_p$ is the density of the charge of polarization, $\vec j$ is
the average (macroscopic) current density, $\vec j_{mp}$ is the
density of the current of polarization and magnetization and
\begin{equation}\label{MaxVacFullnnnngkggjkklhGGint}
\vec D_0:=\vec E+\frac{1}{c}\,\vec v\times \vec
B\quad\text{and}\quad
\vec H_0:=\vec B+\frac{1}{c}\,\vec v\times \vec D_0.
\end{equation}
Consistently with subsection \ref{huugftdtdtddtdt}, it is well known
from the Lorentz theory that in the case of a moving
dielectric/magnetic medium
\begin{equation}\label{PolarGGint}
\rho_p=-div_{\vec x} \vec P\quad\text{and}\quad
\vec j_{mp}=\frac{\partial \vec P}{\partial t}+c\, curl_{\vec x}
\vec M\,,
\end{equation}
where $\vec P:\R^3\times[0,+\infty)\to\R^3$ is the field of
polarization, $\vec M:\R^3\times[0,+\infty)\to\R^3$ is the field of
magnetization and $\vec u:=\vec u(\vec x,t)$ is the field of
velocities of the dielectric medium, see Proposition
\ref{hjghfhfhf11} for the proof of \er{PolarGGint}. Thus, if we
consider
\begin{equation}\label{OprdddGGint}
\vec D:=\vec D_0+4\pi \vec P=\vec E+\frac{1}{c}\,\vec v\times \vec
B+4\pi \vec P,
\end{equation}
and
\begin{equation}\label{Oprddd1GGint}
\vec H:=\vec H_0-4\pi \vec M
=\vec B
+\frac{1}{c}\,\vec
v\times \vec E+\frac{1}{c}\,\vec v\times\left(\frac{1}{c}\,\vec
v\times \vec B\right)-4\pi \vec M,
\end{equation}
we obtain the usual Maxwell equations of the form:
\begin{equation}\label{MaxMedFullGGint}
\begin{cases}
curl_{\vec x} \vec H= \frac{4\pi}{c}\vec j+
\frac{1}{c}\frac{\partial \vec D}{\partial t}
\\
div_{\vec x} \vec D= 4\pi\rho
\\
curl_{\vec x} \vec E+\frac{1}{c}\frac{\partial \vec B}{\partial t}=
0
\\
div_{\vec x} \vec B=0.
%
\end{cases}
\end{equation}
We call $\vec D$ by the electric displacement field and $\vec H$ by
the $\vec H$-magnetic field in a medium.

Next,
in section \ref{DMPGG} we prove that the laws of transformation of
electromagnetic fields in dielectric/magnetic medium, under the
change of non-inertial cartesian coordinate system of the form
\er{noninchgravortbstrjgghguittu2int}, are exactly the same as
\er{yuythfgfyftydtydtydtyddyyyhhddhhhredPPN111hgghjgint} in the
vacuum, i.e. having the form of
\begin{equation}\label{guigikvbvbggjklhjkkgjgGGGGint}
\begin{cases}
\vec D'=A(t)\cdot \vec D\\
\vec B'=A(t)\cdot\vec B\\
\vec E'=A(t)\cdot\vec E-\frac{1}{c}\,\left(\frac{dA}{dt}(t)\cdot\vec
x+\frac{d\vec
z}{dt}(t)\right)\times \left(A(t)\cdot\vec B\right)\\
\vec H'=A(t)\cdot\vec H+\frac{1}{c}\,\left(\frac{dA}{dt}(t)\cdot\vec
x+\frac{d\vec z}{dt}(t)\right)\times \left(A(t)\cdot\vec D\right),
\end{cases}
\end{equation}
provided that
\begin{equation}\label{guigikvbvbGGint}
\begin{cases}
\vec u'=A(t)\cdot \vec u+\frac{dA}{dt}(t)\cdot\vec x+\frac{d\vec z}{dt}(t)\\
\vec v'=A(t)\cdot \vec v+\frac{dA}{dt}(t)\cdot\vec x+\frac{d\vec
z}{dt}(t)\\
\vec P'=A(t)\cdot\vec P,\\
\vec M'=A(t)\cdot\vec M-\frac{1}{c}\,\left(\frac{dA}{dt}(t)\cdot\vec
x+\frac{d\vec z}{dt}(t)\right)\times \left(A(t)\cdot\vec P\right),
\end{cases}
\end{equation}
(see subsection \ref{huugftdtdtddtdt} for the justification of the
last two equalities in \er{guigikvbvbGGint}). Moreover, as before,
the Maxwell equations in the medium of the form \er{MaxMedFullGGint}
stays invariant under the change of inertial or non-inertial
cartesian coordinate system, provided we have
\er{guigikvbvbggjklhjkkgjgGGGGint} and
\begin{equation}\label{hjjgkoooint}
\vec j'=A(t)\cdot \vec j+\rho\,\frac{d A}{dt}(t)\cdot\vec x+\rho\,
\frac{d\vec z}{dt}(t).
\end{equation}

Next, one can show that in the case of simplest moving homogenous
isotropic nonmagnetic dielectric medium we have
\begin{equation}\label{EBDHTrans444knbuihuigGGojkjkuukuint}
\begin{cases}
\vec P=\frac{n^2-1}{4\pi}\left(\vec E+\frac{1}{c}\,\vec u\times \vec B\right),\\
\vec M=-\frac{1}{c}\,\frac{n^2-1}{4\pi}\,\vec u\times \left(\vec
E+\frac{1}{c}\,\vec u\times \vec B\right)=-\frac{1}{c}\vec u\times
\vec P,
\end{cases}
\end{equation}
where $n$ is a material coefficient (not necessary constant), called
\underline{refraction index} and $\vec u$ is the velocity of the
medium. Moreover, in the case of simplest homogenous isotropic
dielectric medium with certain magnetic properties one generalizes
\er{EBDHTrans444knbuihuigGGojkjkuukuint} by the following:
\begin{equation}\label{EBDHTrans444knbuihuigGGint}
\begin{cases}
\vec P=\frac{n^2-1}{4\pi}\left(\vec E+\frac{1}{c}\,\vec u\times \vec B\right)-\kappa\vec D,\\
\vec M=-\frac{1}{c}\,\frac{n^2-1}{4\pi}\,\vec u\times \left(\vec
E+\frac{1}{c}\,\vec u\times \vec B\right)+\kappa\vec
H=-\frac{1}{c}\vec u\times \vec P+\kappa\left(\vec
H-\frac{1}{c}\,\vec u\times \vec D\right),
\end{cases}
\end{equation}
where $n$ is the refraction index and $\kappa$ is an additional
material coefficient (not necessary constant), see subsection
\ref{hhhhhhijh} for the details. In the case of nonmagnetic
dielectric we just put $\kappa=0$ and
\er{EBDHTrans444knbuihuigGGint} becomes to be the same as
\er{EBDHTrans444knbuihuigGGojkjkuukuint}.
Then, using \er{guigikvbvbGGint} and
\er{guigikvbvbggjklhjkkgjgGGGGint}, it can be easily seen that the
laws in \er{EBDHTrans444knbuihuigGGint} are invariant under the
changes of inertial or non-inertial cartesian coordinate system.
Next, denoting
\begin{equation}\label{hkjhjhint}
\kappa_0=\frac{1}{1+4\pi
\kappa}\quad\text{and}\quad\gamma_0=\frac{1+4\pi\kappa}{n^2}=\frac{1}{\kappa_0n^2}\quad\text{so
that}\quad n=\frac{1}{\sqrt{\gamma_0\kappa_0}}
\end{equation}
and defining the speed-like vector field
\begin{equation}\label{OprdddsimGGffyhjyhhzzint}
\vec {\tilde u}:=\left(\frac{1}{n^2}\vec
v+\left(1-\frac{1}{n^2}\right)\,\vec
u\right)=\left(\gamma_0\kappa_0\vec
v+\left(1-\gamma_0\kappa_0\right)\,\vec u\right),
\end{equation}
by plugging \er{EBDHTrans444knbuihuigGGint} into \er{OprdddGGint}
and \er{Oprddd1GGint} we deduce
\begin{equation}\label{OprdddsimGGint}
\vec E=\gamma_0\vec D-\frac{1}{c}\,\vec {\tilde u}\times \vec B,
\end{equation}
and
\begin{equation}\label{Oprddd1simGGint}
\vec H=\kappa_0\vec B+\frac{1}{c}\,\vec {\tilde u}\times \vec
D+\frac{\kappa_0(1-\kappa_0\gamma_0)}{c^2}\,(\vec u-\vec
v)\times\left(\left(\vec u-\vec v\right)\times \vec B\right),
\end{equation}
where we call $\gamma_0$ and $\kappa_0$ dielectric and magnetic
permeability of the medium, see subsection \ref{hhhhhhijh} for the
details. Thus by \er{MaxMedFullGGint},
\er{OprdddsimGGffyhjyhhzzint}, \er{OprdddsimGGint} and
\er{Oprddd1simGGint} we have
\begin{equation}\label{MaxMedFullGGffykkzzkk}
\begin{cases}
curl_{\vec x} \vec H=\frac{4\pi}{c}\vec j+
\frac{1}{c}\frac{\partial \vec D}{\partial t},\\
div_{\vec x} \vec D=4\pi\rho,\\
curl_{\vec x} \vec E+\frac{1}{c}\frac{\partial \vec B}{\partial t}=0,\\
div_{\vec x} \vec B=0,\\
\vec E=\gamma_0\vec D-\frac{1}{c}\,\vec {\tilde u}\times \vec B,\\
\vec H=\kappa_0\vec B+\frac{1}{c}\,\vec {\tilde u}\times \vec
D+\frac{\kappa_0(1-\kappa_0\gamma_0)}{c^2}\,(\vec u-\vec
v)\times\left(\left(\vec
u-\vec v\right)\times \vec B\right),\\
\vec {\tilde u}:=\left(\kappa_0\gamma_0\vec
v+(1-\kappa_0\gamma_0)\vec u\right),
\end{cases}
\end{equation}
where $\vec {\tilde u}$ is a speed-like vector field that we call
the optical displacement of the moving medium. Note that for the
case $\gamma_0=1$ and $\kappa_0=1$, the system
\er{MaxMedFullGGffykkzzkk} is exactly the same as the corresponding
system in the vacuum. The equations in \er{MaxMedFullGGffykkzzkk}
take much simpler forms in the case where the quantity
\begin{equation}\label{OprdddsimGGffyhjyhhtygrffgfzzkk}
\frac{|\kappa_0||1-\kappa_0\gamma_0|\cdot|\vec u-\vec v|^2}{c^2}\ll
1
\end{equation}
is negligible, that happens if either the absolute value of the
difference between the medium velocity and vectorial gravitational
potential is much less then the constant $c$ or
$\gamma_0\kappa_0=\frac{1}{n^2}$ is close to the value $1$. Indeed,
in this case, instead of \er{OprdddsimGGint} and
\er{Oprddd1simGGint} we obtain the following relations:
\begin{align}\label{OprdddsimsimsimGGint}
\vec E=\gamma_0\vec D-\frac{1}{c}\,\vec {\tilde u}\times \vec B,\\
\label{Oprddd1simsimsimGGint}
\vec H=\kappa_0\vec B+\frac{1}{c}\,\vec {\tilde u}\times \vec D.
\end{align}
As a consequence we obtain the full system of Maxwell equations in
the medium:
\begin{equation}\label{MaxMedFullGGffykkhjhhzzkk}
\begin{cases}
curl_{\vec x} \vec H=\frac{4\pi}{c}\vec j+
\frac{1}{c}\frac{\partial \vec D}{\partial t},\\
div_{\vec x} \vec D=4\pi\rho,\\
curl_{\vec x} \vec E+\frac{1}{c}\frac{\partial \vec B}{\partial t}=0,\\
div_{\vec x} \vec B=0,\\
\vec E=\gamma_0\vec D-\frac{1}{c}\,\vec {\tilde u}\times \vec B,\\
\vec H=\kappa_0\vec B+\frac{1}{c}\,\vec {\tilde u}\times \vec D,\\
\vec {\tilde u}=\left(\kappa_0\gamma_0\vec
v+(1-\kappa_0\gamma_0)\vec u\right),
\end{cases}
\end{equation}
where $\vec {\tilde u}$ is the speed-like vector field and
$\gamma_0$ and $\kappa_0$ are dielectric and magnetic permeability
of the medium. Note that \er{MaxMedFullGGffykkhjhhzzkk} is analogous
to the system of Maxwell equations in the vacuum and it is also
invariant under the change of inertial or non-inertial cartesian
coordinate system, provided that under this transformation we have
\er{guigikvbvbGGint}.

Next, in the case of the simplest anisotropic dielectric and/or
magnetic, it is well known that we have the following generalization
of \er{EBDHTrans444knbuihuigGGint}:
\begin{equation}\label{EBDHTrans444knbuihuigGGaaint}
\begin{cases}
\vec P=\Gamma\cdot\left(\vec E+\frac{1}{c}\,\vec u\times \vec B\right)-\Upsilon\cdot\vec D,\\
\vec M=
-\frac{1}{c}\vec u\times \vec P+\Upsilon\cdot\left(\vec
H-\frac{1}{c}\,\vec u\times \vec D\right),
\end{cases}
\end{equation}
where $\Gamma\in \mathbb{R}^{3\times 3}$ and $\Upsilon\in
\mathbb{R}^{3\times 3}$ are matrix-valued fields.
Then, using \er{guigikvbvbGGint} and
\er{guigikvbvbggjklhjkkgjgGGGGint} it can be easily seen that the
laws in \er{EBDHTrans444knbuihuigGGaaint} are invariant under the
changes of inertial or non-inertial cartesian coordinate system,
provided that $\Gamma$ and $\Upsilon$ are \underline{proper} matrix
fields (see Definition \ref{bggghghgjint}).

Next, it is well known that the Ohm's Law in a conducting medium has
the form
\begin{equation}\label{vjhfhjtjhjhuyyiyGGint}
\vec j-\rho\,\vec u=\varepsilon\left(\vec E+\frac{1}{c}\,\vec
u\times \vec B\right),
\end{equation}
where $\vec u$ is the velocity of the medium and $\varepsilon$ is a
material coefficient. As before, using
\er{guigikvbvbGGint} and \er{guigikvbvbggjklhjkkgjgGGGGint}, it can
be easily seen that the Ohm's Law is invariant under the changes of
inertial or non-inertial cartesian coordinate system.

Furthermore, it is well known that in the case of the strong
magnetic field the modification of the the Ohm's Law including the
Hall effect has the following form:
\begin{equation}\label{vjhfhjtjhjhuyyiyGGjljint}
\vec j-\rho\,\vec u=\varepsilon\left(\vec E+\frac{1}{c}\,\vec
u\times \vec B\right)-\varsigma\left(\vec j-\rho\,\vec
u\right)\times \vec B,
\end{equation}
where $\varsigma$ is a material coefficient. Then, as before,  using
\er{guigikvbvbGGint} and \er{guigikvbvbggjklhjkkgjgGGGGint}, it can
be easily seen that the generalized Ohm's Law
\er{vjhfhjtjhjhuyyiyGGjljint}, including the Hall effect, is
invariant under the changes of inertial or non-inertial cartesian
coordinate system.

Next, it is well known that, in the case of the anisotropic
conducting medium, the Ohm's Law has the following form,
generalizing \er{vjhfhjtjhjhuyyiyGGint}:
\begin{equation}\label{vjhfhjtjhjhuyyiyGGaaint}
\vec j-\rho\,\vec u=\Xi\cdot\left(\vec E+\frac{1}{c}\,\vec u\times
\vec B\right),
\end{equation}
where $\Xi\in \mathbb{R}^{3\times 3}$ is a matrix-valued field. As
before, using
\er{guigikvbvbGGint} and \er{guigikvbvbggjklhjkkgjgGGGGint}, it can
be easily seen that \er{vjhfhjtjhjhuyyiyGGaaint} is invariant under
the changes of inertial or non-inertial cartesian coordinate system,
provided that $\Xi$ is a \underline{proper} matrix field.

Finally, the generalization of \er{vjhfhjtjhjhuyyiyGGjljint} to
anisotropic mediums is the following:
\begin{equation}\label{vjhfhjtjhjhuyyiyGGjljaaint}
\vec j-\rho\,\vec u=\Xi\cdot\left(\vec E+\frac{1}{c}\,\vec u\times
\vec B\right)-\Pi\cdot\left(\left(\vec j-\rho\,\vec u\right)\times
\vec B\right),
\end{equation}
where $\Xi\in \mathbb{R}^{3\times 3}$ and $\Pi\in
\mathbb{R}^{3\times 3}$  are matrix-valued fields. As before,  using
\er{guigikvbvbGGint} and \er{guigikvbvbggjklhjkkgjgGGGGint}, it can
be easily seen that \er{vjhfhjtjhjhuyyiyGGjljaaint} is invariant
under the changes of inertial or non-inertial cartesian coordinate
system, provided that $\Xi$ and $\Pi$ are \underline{proper} matrix
fields.

\subsubsection{Optical dispersion in moving mediums}
Remind that if  $U$ is a real valued field, then we can write the
field $U$ as a Fourier's Transform on the time variable:
\begin{equation}\label{MaxVacFullPPNmmmffffffhhtygghGGGGyuhggghghghffggggjj11int}
U(\vec x,t)=Re\left\{2\int_{0}^{+\infty} \hat U(\vec
x,\omega)e^{i\omega t}d\omega\right\}\quad\text{where}\quad\hat
U(\vec x,\omega):=\frac{1}{2\pi}\int_{-\infty}^{+\infty} U(\vec
x,t)e^{-i\omega t}dt\,.
\end{equation}
Thus, since $\vec E$, $\vec B$, $\vec D$ and $\vec H$ are real
vectors, we can write them as a Fourier's Transform on the time
variable:
\begin{align}\label{MaxVacFullPPNmmmffffffhhtygghGGGGyuhggghghghffggggjj11jhhhjj11hgjgj11ghfh11int}
\vec E(\vec x,t)=Re\left\{2\int_{0}^{+\infty}  {\vec {\hat
E}(\omega,\vec x)}e^{i\omega t}d\omega\right\}\quad\text{where}\quad
{\vec {\hat E}(\omega,\vec
x)}:=\frac{1}{2\pi}\int_{-\infty}^{+\infty}
\vec E(\vec x,t)e^{-i \omega t}dt\,,\\
\label{MaxVacFullPPNmmmffffffhhtygghGGGGyuhggghghghffggggjj11jhhhjj11ihhgjjj11mbbmbm11int}
\vec B(\vec x,t)=Re\left\{2\int_{0}^{+\infty}  {\vec {\hat
B}(\omega,\vec x)}e^{i\omega t}d\omega\right\}\quad\text{where}\quad
{\vec {\hat B}(\omega,\vec
x)}:=\frac{1}{2\pi}\int_{-\infty}^{+\infty} \vec B(\vec x,t)e^{-i
\omega
t}dt\,,\\
\label{MaxVacFullPPNmmmffffffhhtygghGGGGyuhggghghghffggggjj11jhhhjj11hgjgj11ghfh11Dint}
\vec D(\vec x,t)=Re\left\{2\int_{0}^{+\infty}  {\vec {\hat
D}(\omega,\vec x)}e^{i\omega t}d\omega\right\}\quad\text{where}\quad
{\vec {\hat D}(\omega,\vec
x)}:=\frac{1}{2\pi}\int_{-\infty}^{+\infty}
\vec D(\vec x,t)e^{-i \omega t}dt\,,\\
\label{MaxVacFullPPNmmmffffffhhtygghGGGGyuhggghghghffggggjj11jhhhjj11ihhgjjj11mbbmbm11Hint}
\vec H(\vec x,t)=Re\left\{2\int_{0}^{+\infty}  {\vec {\hat
H}(\omega,\vec x)}e^{i\omega t}d\omega\right\}\quad\text{where}\quad
{\vec {\hat H}(\omega,\vec
x)}:=\frac{1}{2\pi}\int_{-\infty}^{+\infty} \vec H(\vec x,t)e^{-i
\omega t}dt\,,
\end{align}
where given $\vec a=(a_1,a_2,a_3)\in \mathbb{C}^3$ we denote
$Re\,\{\vec a\}=\left(Re\,\{a_1\},Re\,\{a_2\},Re\,\{a_3\}\right)\in
\mathbb{R}^3$. Then, it is well known that in the case of the
simplest optical dispersion in the resting medium with $\vec u\equiv
0$ we have the following generalization of
\er{EBDHTrans444knbuihuigGGint}:
\begin{equation}\label{EBDHTrans444knbuihuigGGjj11jgggggjj11hhhjjj11gvghhghgjj11jgjggj11hghffh11int}
\begin{cases}
\vec P\left(\vec x,t\right)=Re\left\{2\int_{0}^{+\infty}
{\frac{n^2\left(\omega,\vec x,t\right)-1}{4\pi}\,\vec {\hat
E}\left(\omega,\vec
 x\right)}e^{i\omega t}d\omega\right\}-Re\left\{2\int_{0}^{+\infty}
{\kappa\left(\omega,\vec x,t\right)\vec {\hat D}\left(\omega,\vec
x\right)}e^{i\omega t}d\omega\right\},\\
\vec M\left(\vec x,t\right)=Re\left\{2\int_{0}^{+\infty}
{\kappa\left(\omega,\vec x,t\right)\vec {\hat H}\left(\omega,\vec
x\right)}e^{i\omega t}d\omega\right\},
\end{cases}
\end{equation}
where $\vec {\hat E}(\omega,\vec x)$, $\vec {\hat B}(\omega,\vec
x)$, $\vec {\hat D}(\omega,\vec x)$ and $\vec {\hat H}(\omega,\vec
x)$ are Fourier's Transforms by time of $\vec E(\vec x,t)$, $\vec
B(\vec x,t)$, $\vec D(\vec x,t)$ and $\vec H(\vec x,t)$, given by
the second equalities of
\er{MaxVacFullPPNmmmffffffhhtygghGGGGyuhggghghghffggggjj11jhhhjj11hgjgj11ghfh11int},
\er{MaxVacFullPPNmmmffffffhhtygghGGGGyuhggghghghffggggjj11jhhhjj11ihhgjjj11mbbmbm11int},
\er{MaxVacFullPPNmmmffffffhhtygghGGGGyuhggghghghffggggjj11jhhhjj11hgjgj11ghfh11Dint}
and
\er{MaxVacFullPPNmmmffffffhhtygghGGGGyuhggghghghffggggjj11jhhhjj11ihhgjjj11mbbmbm11Hint}
and the quantities $n\left(\omega,\vec x,t\right)$ and
$\kappa\left(\omega,\vec x,t\right)$ in
\er{EBDHTrans444knbuihuigGGjj11jgggggjj11hhhjjj11gvghhghgjj11jgjggj11hghffh11int}
are assumed to be \underline{complex} in general and assumed to
depend on $\omega$ in addition to the dependence on $(\vec x,t)$.

We would like to obtain the law being an analog of
\er{EBDHTrans444knbuihuigGGjj11jgggggjj11hhhjjj11gvghhghgjj11jgjggj11hghffh11int}
in the case of moving medium (i.e. $\vec u\neq0$), which is
invariant under the change of inertial or non-inertial cartesian
coordinate system. In addition to the invariance under the change of
cartesian coordinate systems and the equivalence to
\er{EBDHTrans444knbuihuigGGjj11jgggggjj11hhhjjj11gvghhghgjj11jgjggj11hghffh11int}
in the particular case $\vec u\equiv 0$, we also need to assume that
in the case of an arbitrary moving transparent medium without
dispersion our law is equivalent to \er{EBDHTrans444knbuihuigGGint}.

Then, in some cartesian coordinate system consider a motion of some
continuum medium occupying a region $\Omega\subset\mathbb{R}^3$ at
some fixed instant of time $t=t_0$ and having the velocity field
$\vec u:=\vec u(\vec x,t)$. Next, as before, let $\vec r(t,\vec
y):\mathbb{R}\times\Omega\to\mathbb{R}^3$ be a solution of
\er{hjgghhghhffhkkint} i.e. it satisfies the following initial value
problem for an ordinary differential equation:
\begin{equation}\label{hjgghhghhffhkkjj11int}
\begin{cases}
\frac{\partial\vec r}{\partial t}(t,\vec y)=\vec u\left(\vec
r(t,\vec y),t\right)\quad\quad\forall t\in\mathbb{R},\;\;\forall\vec
y\in\Omega\\
\vec r(t_0,\vec y)=\vec y\quad\quad\forall\vec y\in\Omega.
\end{cases}
\end{equation}
Then, clearly $\vec x=\vec r(t,\vec y)$ stands for the spatial
coordinates at the instant of time $t$ of the parcel of continuum,
having initial coordinates $\vec y$. Thus, as before,
we deduce that $det\left\{d_{\vec y}\vec r(t,\vec y)\right\}\neq 0$
for every instant of time $t$ and so, for the given instant of time
$t$ the mapping $\vec x=\vec r(t,\vec y)$ is locally invertible i.e.
the equation $\vec x=\vec r(t,\vec y)$ can be resolved in $\vec y$.
Thus there exists a regular mapping $\vec y=\vec f(\vec x,t)$, such
that
\begin{equation}\label{hjgghhghhffhkkihiyjhjkhjhjhjj11int}
\vec f\left(\vec r(t,\vec y),t\right)=\vec y\quad\forall \vec
y\in\Omega\quad\quad\text{and}\quad\quad\vec r\left(t,\vec f(\vec
x,t)\right)=\vec x.
\end{equation}
Then, clearly $\vec y=\vec f(\vec x,t)$ stands for the initial
coordinates at the time $t_0$ of the parcel of continuum, having
coordinates $\vec x$ at the instant of time $t$. Next $\vec f(\vec
x,t)$ satisfies:
\begin{equation}\label{hjgghhghhffhkkmkljjnnnnjj11int}
\begin{cases}
\frac{\partial\vec f}{\partial t}\left(\vec x,t\right)+d_{\vec
x}\vec f\left(\vec x,t\right)\cdot\vec u\left(\vec x,t\right)=0
\\
\vec f(\vec x,t_0)=\vec x.
\end{cases}
\end{equation}

Next consider,
\begin{equation}\label{EBDHTrans444knbuihuigGGjj11ghthgtjkjj11int}
\begin{cases}
\vec E^*(t,\vec y):=\left\{d_{\vec y}\vec r\left(t,\vec
y\right)\right\}^T\cdot\left(\vec E\left(\vec r(t,\vec
y),t\right)+\frac{1}{c}\,\vec u\left(\vec r(t,\vec y),t\right)\times
\vec B\left(\vec r(t,\vec y),t\right)\right)\quad\quad\forall
t\in\mathbb{R},\;\;\forall\vec
y\in\Omega,\\
\vec B^*(t,\vec y):=\left\{d_{\vec y}\vec r\left(t,\vec
y\right)\right\}^T\cdot\vec B\left(\vec r(t,\vec
y),t\right)\quad\quad\forall t\in\mathbb{R},\;\;\forall\vec
y\in\Omega,\\
\vec D^*(t,\vec y):=\left\{d_{\vec y}\vec r\left(t,\vec
y\right)\right\}^T\cdot\vec D\left(\vec r(t,\vec
y),t\right)\quad\quad\forall t\in\mathbb{R},\;\;\forall\vec
y\in\Omega,\\
\vec H^*(t,\vec y):=\left\{d_{\vec y}\vec r\left(t,\vec
y\right)\right\}^T\cdot\left(\vec H\left(\vec r(t,\vec
y),t\right)-\frac{1}{c}\,\vec u\left(\vec r(t,\vec y),t\right)\times
\vec D\left(\vec r(t,\vec y),t\right)\right)\quad\quad\forall
t\in\mathbb{R},\;\;\forall\vec y\in\Omega,
\end{cases}
\end{equation}
where $\vec r(t,\vec y)$ is given by \er{hjgghhghhffhkkjj11}.
Then, since $\vec E^*$, $\vec B^*$, $\vec D^*$ and $\vec H^*$ are
real vectors, we can write them as a Fourier's Transform on the time
variable:
\begin{align}
\label{MaxVacFullPPNmmmffffffhhtygghGGGGyuhggghghghffggggjj11jhhhjj11int}
\vec E^*(t,\vec y)=Re\left\{2\int_{0}^{+\infty}  {\vec {\hat
E^*}(\tilde\omega,\vec y)}e^{i\tilde \omega
t}d\tilde\omega\right\}\quad\text{where}\quad {\vec {\hat
E^*}(\tilde\omega,\vec y)}:=\frac{1}{2\pi}\int_{-\infty}^{+\infty}
\vec E^*(t,\vec y)e^{-i\tilde \omega t}dt\,,\\
\label{MaxVacFullPPNmmmffffffhhtygghGGGGyuhggghghghffggggjj11jhhhjj11ihhgjjj11int}
\vec B^*(t,\vec y)=Re\left\{2\int_{0}^{+\infty}  {\vec {\hat
B^*}(\tilde\omega,\vec y)}e^{i\tilde \omega
t}d\tilde\omega\right\}\quad\text{where}\quad {\vec {\hat
B^*}(\tilde\omega,\vec y)}:=\frac{1}{2\pi}\int_{-\infty}^{+\infty}
\vec B^*(t,\vec y)e^{-i\tilde \omega t}dt\,,\\
\label{MaxVacFullPPNmmmffffffhhtygghGGGGyuhggghghghffggggjj11jhhhjj11Dint}
\vec D^*(t,\vec y)=Re\left\{2\int_{0}^{+\infty}  {\vec {\hat
D^*}(\tilde\omega,\vec y)}e^{i\tilde \omega
t}d\tilde\omega\right\}\quad\text{where}\quad {\vec {\hat
D^*}(\tilde\omega,\vec y)}:=\frac{1}{2\pi}\int_{-\infty}^{+\infty}
\vec D^*(t,\vec y)e^{-i\tilde \omega t}dt\,,\\
\label{MaxVacFullPPNmmmffffffhhtygghGGGGyuhggghghghffggggjj11jhhhjj11ihhgjjj11Hint}
\vec H^*(t,\vec y)=Re\left\{2\int_{0}^{+\infty}  {\vec {\hat
H^*}(\tilde\omega,\vec y)}e^{i\tilde \omega
t}d\tilde\omega\right\}\quad\text{where}\quad {\vec {\hat
H^*}(\tilde\omega,\vec y)}:=\frac{1}{2\pi}\int_{-\infty}^{+\infty}
\vec H^*(t,\vec y)e^{-i\tilde \omega t}dt\,.
\end{align}
Then we write the generalization of
\er{EBDHTrans444knbuihuigGGjj11jgggggjj11hhhjjj11gvghhghgjj11jgjggj11hghffh11int}
as:
\begin{multline}\label{EBDHTrans444knbuihuigGGjj11jgggggjj11hhhjjj11int}
\vec P\left(\vec r(t,\vec y),t\right)= Re\left\{2\int_{0}^{+\infty}
{\frac{n^2\left(\tilde\omega,\vec r(t,\vec
y),t\right)-1}{4\pi}\,\left(\left(\left\{d_{\vec y}\vec
r\left(t,\vec y\right)\right\}^T\right)^{-1}\cdot\vec {\hat
E^*}\left(\omega,\vec
 x\right)\right)}e^{i\omega t}d\omega\right\}\\-Re\left\{2\int_{0}^{+\infty}
{\kappa\left(\tilde\omega,\vec r(t,\vec
y),t\right)\left(\left(\left\{d_{\vec y}\vec r\left(t,\vec
y\right)\right\}^T\right)^{-1}\cdot\vec {\hat D^*}\left(\omega,\vec
x\right)\right)}e^{i\omega t}d\omega\right\}\\
\quad\text{and}\quad\quad\quad \vec M\left(\vec r(t,\vec
y),t\right)+\frac{1}{c}\vec u\left(\vec
r(t,\vec y),t\right)\times \vec P\left(\vec r(t,\vec y),t\right)=\\
Re\left\{2\int_{0}^{+\infty} {\kappa\left(\tilde\omega,\vec r(t,\vec
y),t\right)\left(\left(\left\{d_{\vec y}\vec r\left(t,\vec
y\right)\right\}^T\right)^{-1}\cdot\vec {\hat H^*}(\tilde\omega,\vec
y)\right)}e^{i\tilde \omega t}d\tilde\omega\right\},
\end{multline}
i.e.
\begin{multline}\label{EBDHTrans444knbuihuigGGjj11jgggggjj11hhhjjj11gvghhghgjj11int}
\vec P\left(\vec x,t\right)=Re\left\{2\int_{0}^{+\infty}
{\frac{n^2\left(\tilde\omega,\vec
x,t\right)-1}{4\pi}\,\left(\left\{d_{\vec x}\vec f\left(\vec
x,t\right)\right\}^T\cdot\vec {\hat E^*}\left(\tilde\omega,\vec
f\left(\vec x,t\right)\right)\right)}e^{i\tilde \omega
t}d\tilde\omega\right\}\\-Re\left\{2\int_{0}^{+\infty}
{\kappa\left(\tilde\omega,\vec x,t\right)\left(\left\{d_{\vec x}\vec
f\left(\vec x,t\right)\right\}^T\cdot\vec {\hat
D^*}\left(\tilde\omega,\vec f\left(\vec
x,t\right)\right)\right)}e^{i\tilde \omega t}d\tilde\omega\right\}\quad\quad\text{and} \\
\vec M\left(\vec x,t\right)+\frac{1}{c}\vec u(\vec x,t)\times \vec
P(\vec x,t)=Re\left\{2\int_{0}^{+\infty}
{\kappa\left(\tilde\omega,\vec x,t\right)\left(\left\{d_{\vec x}\vec
f\left(\vec x,t\right)\right\}^T\cdot\vec {\hat
H^*}\left(\tilde\omega,\vec f\left(\vec
x,t\right)\right)\right)}e^{i\tilde \omega t}d\tilde\omega\right\}.
\end{multline}
The quantities $n\left(\tilde\omega,\vec x,t\right)$ and
$\kappa\left(\tilde\omega,\vec x,t\right)$ in
\er{EBDHTrans444knbuihuigGGjj11jgggggjj11hhhjjj11int} and
\er{EBDHTrans444knbuihuigGGjj11jgggggjj11hhhjjj11gvghhghgjj11int}
are assumed to be \underline{complex} in general and assumed to
depend on $\tilde\omega$ in addition to the dependence on $(\vec
x,t)$.

In particular, in the case $\vec u\equiv 0$ we clearly have $\vec
r(t,\vec y)\equiv\vec y$, $\vec f\left(\vec x,t\right)\equiv\vec x$
and $\vec E^*(t,\vec x)=\vec E(\vec x,t)$, $\vec B^*(t,\vec x)=\vec
B(\vec x,t)$, $\vec D^*(t,\vec x)=\vec D(\vec x,t)$, $\vec
H^*(t,\vec x)=\vec H(\vec x,t)$ and therefore,
\er{EBDHTrans444knbuihuigGGjj11jgggggjj11hhhjjj11int} and
\er{EBDHTrans444knbuihuigGGjj11jgggggjj11hhhjjj11gvghhghgjj11int}
coincide with
\er{EBDHTrans444knbuihuigGGjj11jgggggjj11hhhjjj11gvghhghgjj11jgjggj11hghffh11int}.

On the other hand, in subsection \ref{hugygyggygygyfgy} we prove
that in the case of an arbitrary moving transparent medium without
dispersion i.e. in the case where we assume that $n:=n\left(\vec
x,t\right)$ and $\kappa:=\kappa\left(\vec x,t\right)$ are
independent of the argument $\tilde\omega$ and moreover we assume
them to be \underline{real}, we can rewrite
\er{EBDHTrans444knbuihuigGGjj11jgggggjj11hhhjjj11gvghhghgjj11int} in
the simpler form:
\begin{equation}\label{EBDHTrans444knbuihuigGGjj11gjjggjg11ghhfgg11int}
\begin{cases}
\vec P(\vec x,t)=\frac{n^2\left(\vec x,t\right)-1}{4\pi}\,\left(\vec
E(\vec x,t)+\frac{1}{c}\,\vec u(\vec x,t)\times \vec B(\vec
x,t)\right)-\kappa(\vec x,t)\,\vec D(\vec
x,t),\\
\vec M(\vec x,t)=\kappa(\vec x,t)\,\left(\vec H(\vec
x,t)-\frac{1}{c}\,\vec u(\vec x,t)\times \vec D(\vec
x,t)\right)-\frac{1}{c}\vec u\left(\vec x,t\right)\times \vec
P\left(\vec x,t\right),
\end{cases}
\end{equation}
consistently with \er{EBDHTrans444knbuihuigGGint}.

Next, as before, assume that the change of some non-inertial
cartesian system $(*)$ of coordinates to another cartesian system
$(**)$ of coordinates is of the form:
\begin{equation*}
\begin{cases}
\vec x'=A(t)\cdot\vec x+\vec z(t),\\
t'=t,
\end{cases}
\end{equation*}
where $A(t)\in SO(3)$ is a rotation. Then, in subsection
\ref{hugygyggygygyfgy} we prove that the dispersion law, in the
forms of \er{EBDHTrans444knbuihuigGGjj11jgggggjj11hhhjjj11int} or
equivalently
\er{EBDHTrans444knbuihuigGGjj11jgggggjj11hhhjjj11gvghhghgjj11int},
is invariant under the change of inertial or non-inertial cartesian
coordinate system, provided we have
\begin{equation}\label{hjgjgjgkllujkjjjkhhjhljjkjhhjh11hkkhjhgj11jkjkjkgjg11hkhk4hjhjh11int}
\tilde\omega'=\tilde\omega,\quad n'\left(\tilde\omega',\vec
x',t'\right)=n\left(\tilde\omega,\vec x,t\right)\quad\text{and}\quad
\kappa'\left(\tilde\omega',\vec
x',t'\right)=\kappa\left(\tilde\omega,\vec x,t\right).
\end{equation}

Next, in subsection \ref{hugygyggygygyfgy} we prove the the right
hand side of \er{EBDHTrans444knbuihuigGGjj11jgggggjj11hhhjjj11int}
or \er{EBDHTrans444knbuihuigGGjj11jgggggjj11hhhjjj11gvghhghgjj11int}
is independent of the initial instant of time $t_0$ in
\er{hjgghhghhffhkkjj11int}.

Finally, we can write the following possible alternative to the
dispersion law in the form of
\er{EBDHTrans444knbuihuigGGjj11ghthgtjkjj11int} and
\er{EBDHTrans444knbuihuigGGjj11jgggggjj11hhhjjj11int} or
\er{EBDHTrans444knbuihuigGGjj11jgggggjj11hhhjjj11gvghhghgjj11int}.
We could consider,
\begin{equation}\label{EBDHTrans444knbuihuigGGjj11ghthgtjkjj11jj11int}
\begin{cases}
\vec E^*(t,\vec y):=\left\{d_{\vec y}\vec r\left(t,\vec
y\right)\right\}^{-1}\cdot\left(\vec E\left(\vec r(t,\vec
y),t\right)+\frac{1}{c}\,\vec u\left(\vec r(t,\vec y),t\right)\times
\vec B\left(\vec r(t,\vec y),t\right)\right)\quad\quad\forall
t\in\mathbb{R},\;\;\forall\vec
y\in\Omega,\\
\vec B^*(t,\vec y):=\left\{d_{\vec y}\vec r\left(t,\vec
y\right)\right\}^{-1}\cdot\vec B\left(\vec r(t,\vec
y),t\right)\quad\quad\forall t\in\mathbb{R},\;\;\forall\vec
y\in\Omega,\\
\vec D^*(t,\vec y):=\left\{d_{\vec y}\vec r\left(t,\vec
y\right)\right\}^{-1}\cdot\vec D\left(\vec r(t,\vec
y),t\right)\quad\quad\forall t\in\mathbb{R},\;\;\forall\vec
y\in\Omega,\\
\vec H^*(t,\vec y):=\left\{d_{\vec y}\vec r\left(t,\vec
y\right)\right\}^{-1}\cdot\left(\vec H\left(\vec r(t,\vec
y),t\right)-\frac{1}{c}\,\vec u\left(\vec r(t,\vec y),t\right)\times
\vec D\left(\vec r(t,\vec y),t\right)\right)\quad\quad\forall
t\in\mathbb{R},\;\;\forall\vec y\in\Omega,
\end{cases}
\end{equation}
where $\vec r(t,\vec y)$ is given by \er{hjgghhghhffhkkjj11}.
Then, since $\vec E^*$, $\vec D^*$, $\vec B^*$ and $\vec H^*$ are
real vectors, we could write them as a Fourier's Transform on the
time variable:
\begin{align}\label{MaxVacFullPPNmmmffffffhhtygghGGGGyuhggghghghffggggjj11jhhhjj11jj11int}
\vec E^*(t,\vec y)=Re\left\{2\int_{0}^{+\infty}  {\vec {\hat
E^*}(\tilde\omega,\vec y)}e^{i\tilde \omega
t}d\tilde\omega\right\}\quad\text{where}\quad {\vec {\hat
E^*}(\tilde\omega,\vec y)}:=\frac{1}{2\pi}\int_{-\infty}^{+\infty}
\vec E^*(t,\vec y)e^{-i\tilde \omega t}dt\,,\\
\label{MaxVacFullPPNmmmffffffhhtygghGGGGyuhggghghghffggggjj11jhhhjj11ihhgjjj11bhghhhg11int}
\vec B^*(t,\vec y)=Re\left\{2\int_{0}^{+\infty}  {\vec {\hat
B^*}(\tilde\omega,\vec y)}e^{i\tilde \omega
t}d\tilde\omega\right\}\quad\text{where}\quad {\vec {\hat
B^*}(\tilde\omega,\vec y)}:=\frac{1}{2\pi}\int_{-\infty}^{+\infty}
\vec B^*(t,\vec y)e^{-i\tilde \omega t}dt \,,\\
\label{MaxVacFullPPNmmmffffffhhtygghGGGGyuhggghghghffggggjj11jhhhjj11jj11Dint}
\vec D^*(t,\vec y)=Re\left\{2\int_{0}^{+\infty}  {\vec {\hat
D^*}(\tilde\omega,\vec y)}e^{i\tilde \omega
t}d\tilde\omega\right\}\quad\text{where}\quad {\vec {\hat
D^*}(\tilde\omega,\vec y)}:=\frac{1}{2\pi}\int_{-\infty}^{+\infty}
\vec D^*(t,\vec y)e^{-i\tilde \omega t}dt\,,\\
\label{MaxVacFullPPNmmmffffffhhtygghGGGGyuhggghghghffggggjj11jhhhjj11ihhgjjj11bhghhhg11Hint}
\vec H^*(t,\vec y)=Re\left\{2\int_{0}^{+\infty}  {\vec {\hat
H^*}(\tilde\omega,\vec y)}e^{i\tilde \omega
t}d\tilde\omega\right\}\quad\text{where}\quad {\vec {\hat
H^*}(\tilde\omega,\vec y)}:=\frac{1}{2\pi}\int_{-\infty}^{+\infty}
\vec H^*(t,\vec y)e^{-i\tilde \omega t}dt\,.
\end{align}
Then, alternatively to
\er{EBDHTrans444knbuihuigGGjj11ghthgtjkjj11int} and
\er{EBDHTrans444knbuihuigGGjj11jgggggjj11hhhjjj11int} we could
consider
\begin{multline}\label{EBDHTrans444knbuihuigGGjj11jgggggjj11hhhjjj11jj11int}
\vec P\left(\vec r(t,\vec y),t\right)= Re\left\{2\int_{0}^{+\infty}
{\frac{n^2\left(\tilde\omega,\vec r(t,\vec
y),t\right)-1}{4\pi}\,\left(d_{\vec y}\vec r\left(t,\vec
y\right)\cdot\vec {\hat E^*}\left(\omega,\vec
 x\right)\right)}e^{i\omega t}d\omega\right\}\\-Re\left\{2\int_{0}^{+\infty}
{\kappa\left(\tilde\omega,\vec r(t,\vec y),t\right)\left(d_{\vec
y}\vec r\left(t,\vec y\right)\cdot\vec {\hat D^*}\left(\omega,\vec
x\right)\right)}e^{i\omega t}d\omega\right\}\\
\quad\text{and}\quad\quad\quad \vec M\left(\vec r(t,\vec
y),t\right)+\frac{1}{c}\vec u\left(\vec
r(t,\vec y),t\right)\times \vec P\left(\vec r(t,\vec y),t\right)=\\
Re\left\{2\int_{0}^{+\infty} {\kappa\left(\tilde\omega,\vec r(t,\vec
y),t\right)\left(d_{\vec y}\vec r\left(t,\vec y\right)\cdot\vec
{\hat H^*}(\tilde\omega,\vec y)\right)}e^{i\tilde \omega
t}d\tilde\omega\right\},
\end{multline}
i.e.
\begin{multline}\label{EBDHTrans444knbuihuigGGjj11jgggggjj11hhhjjj11gvghhghgjj11jj11int}
\vec P\left(\vec x,t\right)=Re\left\{2\int_{0}^{+\infty}
{\frac{n^2\left(\tilde\omega,\vec
x,t\right)-1}{4\pi}\,\left(\left\{d_{\vec x}\vec f\left(\vec
x,t\right)\right\}^{-1}\cdot\vec {\hat E^*}\left(\tilde\omega,\vec
f\left(\vec x,t\right)\right)\right)}e^{i\tilde \omega
t}d\tilde\omega\right\}\\-Re\left\{2\int_{0}^{+\infty}
{\kappa\left(\tilde\omega,\vec x,t\right)\left(\left\{d_{\vec x}\vec
f\left(\vec x,t\right)\right\}^{-1}\cdot\vec {\hat
D^*}\left(\tilde\omega,\vec f\left(\vec
x,t\right)\right)\right)}e^{i\tilde \omega t}d\tilde\omega\right\}\quad\quad\text{and}\\
\vec M\left(\vec x,t\right)+\frac{1}{c}\vec u(\vec x,t)\times \vec
P(\vec x,t)=Re\left\{2\int_{0}^{+\infty}
{\kappa\left(\tilde\omega,\vec x,t\right)\left(\left\{d_{\vec x}\vec
f\left(\vec x,t\right)\right\}^{-1}\cdot\vec {\hat
H^*}\left(\tilde\omega,\vec f\left(\vec
x,t\right)\right)\right)}e^{i\tilde \omega t}d\tilde\omega\right\}.
\end{multline}
Then, exactly in the same way we already proved for
\er{EBDHTrans444knbuihuigGGjj11ghthgtjkjj11int} and
\er{EBDHTrans444knbuihuigGGjj11jgggggjj11hhhjjj11int} or
\er{EBDHTrans444knbuihuigGGjj11jgggggjj11hhhjjj11gvghhghgjj11int} we
can also prove that the above alternative
\er{EBDHTrans444knbuihuigGGjj11ghthgtjkjj11jj11int} and
\er{EBDHTrans444knbuihuigGGjj11jgggggjj11hhhjjj11jj11int} or
\er{EBDHTrans444knbuihuigGGjj11jgggggjj11hhhjjj11gvghhghgjj11jj11int}
is invariant under the change of inertial or non-inertial cartesian
coordinate system. Furthermore, as before, the above alternative is
equivalent to
\er{EBDHTrans444knbuihuigGGjj11jgggggjj11hhhjjj11gvghhghgjj11jgjggj11hghffh11int}
in the particular case $\vec u\equiv 0$. Moreover, in the case of an
arbitrary moving transparent medium without dispersion the
alternative law is equivalent to \er{EBDHTrans444knbuihuigGGint}.
Finally, as before,
\er{EBDHTrans444knbuihuigGGjj11ghthgtjkjj11jj11int} and
\er{EBDHTrans444knbuihuigGGjj11jgggggjj11hhhjjj11jj11int} or
\er{EBDHTrans444knbuihuigGGjj11jgggggjj11hhhjjj11gvghhghgjj11jj11int}
is also independent of the initial instant of time $t_0$.

We are unable to see any clear advantage of the dispersion law in
the form of \er{EBDHTrans444knbuihuigGGjj11ghthgtjkjj11int} and
\er{EBDHTrans444knbuihuigGGjj11jgggggjj11hhhjjj11int} or
\er{EBDHTrans444knbuihuigGGjj11jgggggjj11hhhjjj11gvghhghgjj11int}
with respect to the law
\er{EBDHTrans444knbuihuigGGjj11ghthgtjkjj11jj11int} and
\er{EBDHTrans444knbuihuigGGjj11jgggggjj11hhhjjj11jj11int} or
\er{EBDHTrans444knbuihuigGGjj11jgggggjj11hhhjjj11gvghhghgjj11jj11int}.
We just note that the above two forms of the dispersion law coincide
in the case where our medium moves as a \underline{rigid} body.

See also subsection \ref{hugygyggygygyfgyhkjhjhgj} to the
generalization of the dispersion laws to the case of anisotropic
mediums.

\subsection{Some further consequences of Maxwell equations}
Again consider the system of Maxwell equations in the vacuum or in a
medium, without dispersion, having the form
\er{MaxMedFullGGffykkhjhhzzkk}:
\begin{equation}\label{MaxVacFullPPNffGGint}
\begin{cases}
curl_{\vec x} \vec H=\frac{4\pi}{c}\vec j+
\frac{1}{c}\frac{\partial \vec D}{\partial t},\\
div_{\vec x} \vec D=4\pi\rho,\\
curl_{\vec x} \vec E+\frac{1}{c}\frac{\partial \vec B}{\partial t}=0,\\
div_{\vec x} \vec B=0\\
\vec E=\gamma_0\vec D-\frac{1}{c}\,\vec {\tilde u}\times \vec B\\
\vec H=\kappa_0\vec B+\frac{1}{c}\,\vec {\tilde u}\times \vec D,\\
\vec {\tilde u}=\left(\gamma_0\kappa_0\vec
v+(1-\gamma_0\kappa_0)\vec u\right),
\end{cases}
\end{equation}
%
%
%
%
%
%
where $\gamma_0\neq 0$ and $\kappa_0\neq 0$ are material
coefficients, $\vec v$ is the vectorial gravitational potential
$\vec u$ is the medium velocity and $\vec {\tilde
u}=\left(\gamma_0\kappa_0\vec v+(1-\gamma_0\kappa_0)\vec u\right)$
is the speed-like vector field. Remind that in the case of the
vacuum we have $\gamma_0=\kappa_0=1$, $\vec {\tilde u}=\vec v$ and
equations \er{MaxVacFullPPNffGGint} are precise (in the frames of
our model). Otherwise, in the case $\gamma_0\kappa_0\neq 1$
equations \er{MaxVacFullPPNffGGint} are just an approximation that
is good enough for the case:
\begin{equation}\label{OprdddsimGGffyhjyhhtygrffgfzzjjjint}
\frac{|\kappa_0||1-\gamma_0\kappa_0|\cdot|\vec u-\vec v|^2}{c^2}\ll
1.
\end{equation}
Throughout this section we study equation \er{MaxVacFullPPNffGGint}
in domains where we assume that the coefficients $\gamma_0\neq 0$
and $\kappa_0\neq 0$ vary sufficiently slow on the place and time
and thus their spatial and temporal derivatives are negligible with
respect to other terms. Next again by the third and the fourth
equations in \er{MaxVacFullPPNffGGint} we can write
\begin{equation}\label{MaxVacFull1bjkgjhjhgjgjgkjfhjfdghghligioiuittrPPNggint}
\begin{cases}
\vec B\equiv curl_{\vec x} \vec A,\\
\vec E\equiv-\nabla_{\vec x}\Psi-\frac{1}{c}\frac{\partial\vec
A}{\partial t},
\end{cases}
\end{equation}
where $\Psi$ and $\vec A$ are the usual scalar and the vectorial
electromagnetic potentials. Then by
\er{MaxVacFull1bjkgjhjhgjgjgkjfhjfdghghligioiuittrPPNggint} and
\er{MaxVacFullPPNffGGint} we have
\begin{equation}\label{vhfffngghPPN333yuyuGGint}
\begin{cases}
\vec B= curl_{\vec x} \vec A\\
\vec E=-\nabla_{\vec x}\Psi-\frac{1}{c}\frac{\partial\vec
A}{\partial t}\\
 \vec D=-\frac{1}{\gamma_0}\nabla_{\vec
x}\Psi-\frac{1}{\gamma_0 c}\frac{\partial\vec A}{\partial t}+\frac{1}{c\gamma_0}\vec {\tilde u}\times curl_{\vec x}\vec A\\
\vec H= \kappa_0 \,curl_{\vec x} \vec A+\frac{1}{c}\,\vec {\tilde
u}\times\left(-\frac{1}{\gamma_0}\nabla_{\vec
x}\Psi-\frac{1}{\gamma_0 c}\frac{\partial\vec A}{\partial
t}+\frac{1}{\gamma_0 c}\vec {\tilde u}\times curl_{\vec x}\vec
A\right).
\end{cases}
\end{equation}
Next we remind the definition of the proper scalar electromagnetic
potential:
\begin{equation}\label{vhfffngghhjghhgPPNghghghutghffugghjhjkjjklggkkkint}
\Psi_0:=\Psi-\frac{1}{c}\vec A\cdot\vec v,
\end{equation}
and remind also that $\vec A$ is a proper vector field and $\Psi_0$
is a proper scalar field. Then in the case of the medium we also
define an additional scalar electromagnetic potential:
\begin{equation}\label{vhfffngghhjghhgPPNghghghutghffugghjhjkjjklggint}
\Psi_1:=\Psi-\frac{1}{c}\vec A\cdot\vec {\tilde u}.
\end{equation}
Then, since $\vec A$ is a proper vector field, we deduce that
$\Psi_1$ is also a proper scalar field. Moreover, in the case of the
vacuum or more generally in the case where $\gamma_0\kappa_0\approx
1$ we have $\Psi_1=\Psi_0$. Thus by
\er{vhfffngghhjghhgPPNghghghutghffugghjhjkjjklggint} we rewrite
\er{vhfffngghPPN333yuyuGGint} as:
\begin{equation}\label{vhfffngghPPNffGGint}
\begin{cases}
\vec B= curl_{\vec x} \vec A\\
\vec E=-\nabla_{\vec x}\Psi_1-\frac{1}{c}\frac{\partial\vec
A}{\partial t}-\frac{1}{c}\nabla_{\vec x}\left(\vec A\cdot\vec
{\tilde u}\right)
\\
 \vec D=-\frac{1}{\gamma_0}\nabla_{\vec
x}\Psi_1-\frac{1}{\gamma_0 c}\left(\frac{\partial\vec A}{\partial
t}-\vec {\tilde u}\times curl_{\vec x}\vec A+\nabla_{\vec
x}\left(\vec A\cdot\vec {\tilde u}\right)\right)
\\
\vec H= \kappa_0\,curl_{\vec x} \vec A-\frac{1}{c}\,\vec {\tilde
u}\times
\left(\frac{1}{\gamma_0}\nabla_{\vec x}\Psi_1+\frac{1}{\gamma_0
c}\left(\frac{\partial\vec A}{\partial t}-\vec {\tilde u}\times
curl_{\vec x}\vec A+\nabla_{\vec x}\left(\vec A\cdot\vec {\tilde
u}\right)\right)\right).
\end{cases}
\end{equation}
Then by inserting \er{vhfffngghPPNffGGint} into
\er{MaxVacFullPPNffGGint} straightforward calculations presented in
subsection \ref{gcCM} lead to the following equations:
\begin{multline}\label{MaxVacFullPPNmmmffGG111jjkkhkjhdfjosjfint}
\left(\frac{\partial}{\partial t}\left\{\frac{1}{c}\Div_{\vec x}\vec
A\right\}+\Div_{\vec x}\left\{\frac{1}{c}\left(\Div_{\vec x}\vec
A\right)\vec {\tilde u}\right\}\right)+\Delta_{\vec x}\Psi_1\\+
\Div_{\vec x}\left\{\frac{1}{c}\left(d_{\vec x}\vec {\tilde
u}+\left\{d_{\vec x}\vec {\tilde u}\right\}^T\right)\cdot\vec
A-\frac{1}{c}\left(\Div_{\vec x}\vec {\tilde u}\right)\vec A\right\}
=-4\pi\gamma_0\rho
\end{multline}
and
\begin{multline}\label{MaxVacFullPPNnnnffGG111ikiojiojiscjkhhint}
-\Delta_{\vec x}\vec A =\frac{4\pi}{\kappa_0 c}\left(\vec j-\rho\vec
{\tilde u}\right)- \nabla_{\vec x}\left(\frac{1}{\kappa_0\gamma_0
c}\left(\frac{\partial\Psi_1}{\partial t}+\vec {\tilde
u}\cdot\nabla_{\vec x}\Psi_1\right)+\left(\Div_{\vec x}\vec
A\right)\right)
\\+ \left(\left(d_{\vec x}\vec {\tilde
u}+\left\{d_{\vec x}\vec {\tilde
u}\right\}^T\right)\cdot\left\{\frac{1}{\kappa_0\gamma_0
c}\nabla_{\vec x}\Psi_1\right\}-\left(\Div_{\vec x}\vec {\tilde
u}\right)\left\{\frac{1}{\kappa_0\gamma_0 c}\nabla_{\vec
x}\Psi_1\right\}\right)
\\-\frac{\partial}{\partial
t}\left\{\frac{1}{\gamma_0\kappa_0 c^2}\left(\frac{\partial\vec
A}{\partial t}+d_{\vec x}\vec A\cdot\vec {\tilde u}+\left\{d_{\vec
x}\vec {\tilde u}\right\}^T\cdot\vec A\right)\right\}\\- \Div_{\vec
x} \left\{\left(\frac{1}{\gamma_0\kappa_0
c^2}\left(\frac{\partial\vec A}{\partial t}+d_{\vec x}\vec
A\cdot\vec {\tilde u}+\left\{d_{\vec x}\vec
{\tilde u}\right\}^T\cdot\vec A\right)\right)\otimes\vec {\tilde u}\right\}\\
 +\left(d_{\vec
x}\vec {\tilde u}\cdot\left\{\frac{1}{\gamma_0\kappa_0
c^2}\left(\frac{\partial\vec A}{\partial t}+d_{\vec x}\vec
A\cdot\vec {\tilde u}+\left\{d_{\vec x}\vec {\tilde
u}\right\}^T\cdot\vec A\right)\right\}\right).
\end{multline}
Alternatively, we rewrite
\er{MaxVacFullPPNnnnffGG111ikiojiojiscjkhhint} as:
\begin{multline}\label{MaxVacFullPPNnnnffGG111ikiojiojiscjkhhopopint}
-\Delta_{\vec x}\vec A = \frac{4\pi}{\kappa_0 c}\left(\vec
j-\rho\vec {\tilde u}\right)-\nabla_{\vec x}\left(\Div_{\vec x}\vec
A\right)\\- \frac{1}{\gamma_0\kappa_0
c}\left(\frac{\partial}{\partial t}\left\{\nabla_{\vec
x}\Psi_1\right\}-curl_{\vec x}\left\{\vec {\tilde u}\times
\nabla_{\vec x}\Psi_1\right\}+\left(\Delta_{\vec x}\Psi_1\right)\vec
{\tilde u}\right)
\\-\frac{\partial}{\partial
t}\left\{\frac{1}{\gamma_0\kappa_0 c^2}\left(\frac{\partial\vec
A}{\partial t}+d_{\vec x}\vec A\cdot\vec {\tilde u}+\left\{d_{\vec
x}\vec {\tilde u}\right\}^T\cdot\vec A\right)\right\}\\- \Div_{\vec
x} \left\{\left(\frac{1}{\gamma_0\kappa_0
c^2}\left(\frac{\partial\vec A}{\partial t}+d_{\vec x}\vec
A\cdot\vec {\tilde u}+\left\{d_{\vec x}\vec
{\tilde u}\right\}^T\cdot\vec A\right)\right)\otimes\vec {\tilde u}\right\}\\
 +\left(d_{\vec
x}\vec {\tilde u}\cdot\left\{\frac{1}{\gamma_0\kappa_0
c^2}\left(\frac{\partial\vec A}{\partial t}+d_{\vec x}\vec
A\cdot\vec {\tilde u}+\left\{d_{\vec x}\vec {\tilde
u}\right\}^T\cdot\vec A\right)\right\}\right).
\end{multline}

Next if we assume the following calibration of the potentials:
\begin{equation}\label{MaxVacFullPPNjjjjffhhGGGGGGint}
\Div_{\vec x}\vec A=0,
\end{equation}
then by \er{MaxVacFullPPNjjjjffhhGGGGGGint},
\er{MaxVacFullPPNmmmffGG111jjkkhkjhdfjosjfint} and
\er{MaxVacFullPPNnnnffGG111ikiojiojiscjkhhopopint} we have
\begin{equation}\label{MaxVacFullPPNmmmffffffhhtygghGGGGint}
-\Delta_{\vec x}\Psi_1=4\pi\gamma_0\rho+\Div_{\vec x}
\left\{\frac{1}{c}\left(d_{\vec x}\vec {\tilde u}+\left\{d_{\vec
x}\vec {\tilde u}\right\}^T\right)\cdot\vec
A-\frac{1}{c}\left(\Div_{\vec x}\vec {\tilde u}\right)\vec
A\right\},
\end{equation}
and
\begin{multline}\label{MaxVacFullPPNnnnffffffyuughjhjhjhhjjkjhkkjhhjhghGGGGint}
-\Delta_{\vec x}\vec A = \frac{4\pi}{\kappa_0 c}\left(\vec j-
\frac{1}{4\pi\gamma_0}\left(\frac{\partial}{\partial
t}\left\{\nabla_{\vec x}\Psi_1\right\}-curl_{\vec x}\left\{\vec
{\tilde u}\times \nabla_{\vec
x}\Psi_1\right\}\right)\right)\\+\frac{1}{\gamma_0\kappa_0
c^2}\left( \Div_{\vec x} \left\{\left(d_{\vec x}\vec {\tilde
u}+\left\{d_{\vec x}\vec {\tilde u}\right\}^T\right)\cdot\vec
A-\left(\Div_{\vec x}\vec {\tilde u}\right)\vec A\right\}\right)\vec
{\tilde u}
\\-\frac{\partial}{\partial
t}\left\{\frac{1}{\gamma_0\kappa_0 c^2}\left(\frac{\partial\vec
A}{\partial t}+d_{\vec x}\vec A\cdot\vec {\tilde u}+\left\{d_{\vec
x}\vec {\tilde u}\right\}^T\cdot\vec A\right)\right\}\\- \Div_{\vec
x} \left\{\left(\frac{1}{\gamma_0\kappa_0
c^2}\left(\frac{\partial\vec A}{\partial t}+d_{\vec x}\vec
A\cdot\vec {\tilde u}+\left\{d_{\vec x}\vec
{\tilde u}\right\}^T\cdot\vec A\right)\right)\otimes\vec {\tilde u}\right\}\\
 +\left(d_{\vec
x}\vec {\tilde u}\cdot\left\{\frac{1}{\gamma_0\kappa_0
c^2}\left(\frac{\partial\vec A}{\partial t}+d_{\vec x}\vec
A\cdot\vec {\tilde u}+\left\{d_{\vec x}\vec {\tilde
u}\right\}^T\cdot\vec A\right)\right\}\right),
\end{multline}
(See subsection \ref{gcCM} for details).

On the other hand, if we assume the following alternative
calibration of the potentials:
\begin{equation}\label{MaxVacFullPPNjjjjffhhGGint}
\frac{1}{\kappa_0\gamma_0 c}\left(\frac{\partial\Psi_1}{\partial
t}+\vec {\tilde u}\cdot\nabla_{\vec x}\Psi_1\right)+\Div_{\vec
x}\vec A=0,
\end{equation}
then by \er{MaxVacFullPPNmmmffGG111jjkkhkjhdfjosjfint},
\er{MaxVacFullPPNnnnffGG111ikiojiojiscjkhhint} we have
\begin{multline}\label{MaxVacFullPPNmmmffffffiuiuhjuGGint}
\left(\frac{\partial}{\partial t}\left\{\frac{1}{\kappa_0\gamma_0
c^2}\left(\frac{\partial\Psi_1}{\partial t}+\vec {\tilde
u}\cdot\nabla_{\vec x}\Psi_1\right)\right\}+\Div_{\vec x}
\left\{\frac{1}{\kappa_0\gamma_0
c^2}\left(\frac{\partial\Psi_1}{\partial t}+\vec {\tilde
u}\cdot\nabla_{\vec x}\Psi_1\right)\vec {\tilde
u}\right\}\right)-\Delta_{\vec
x}\Psi_1\\=4\pi\gamma_0\rho+\Div_{\vec x}
\left\{\frac{1}{c}\left(d_{\vec x}\vec {\tilde u}+\left\{d_{\vec
x}\vec {\tilde u}\right\}^T\right)\cdot\vec
A-\frac{1}{c}\left(\Div_{\vec x}\vec {\tilde u}\right)\vec
A\right\},
\end{multline}
and
\begin{multline}\label{MaxVacFullPPNnnnffffffyuughjhjhjhhjjkjhkkjhujgGGint}
-\Delta_{\vec x}\vec A= \frac{4\pi}{\kappa_0 c}\left(\vec j-\rho\vec
{\tilde u}\right)+ \left(\left(d_{\vec x}\vec {\tilde
u}+\left\{d_{\vec x}\vec {\tilde
u}\right\}^T\right)\cdot\left\{\frac{1}{\kappa_0\gamma_0
c}\nabla_{\vec x}\Psi_1\right\}-\left(\Div_{\vec x}\vec {\tilde
u}\right)\left\{\frac{1}{\kappa_0\gamma_0 c}\nabla_{\vec
x}\Psi_1\right\}\right)
\\-\frac{\partial}{\partial
t}\left\{\frac{1}{\gamma_0\kappa_0 c^2}\left(\frac{\partial\vec
A}{\partial t}+d_{\vec x}\vec A\cdot\vec {\tilde u}+\left\{d_{\vec
x}\vec {\tilde u}\right\}^T\cdot\vec A\right)\right\}\\- \Div_{\vec
x} \left\{\left(\frac{1}{\gamma_0\kappa_0
c^2}\left(\frac{\partial\vec A}{\partial t}+d_{\vec x}\vec
A\cdot\vec {\tilde u}+\left\{d_{\vec x}\vec
{\tilde u}\right\}^T\cdot\vec A\right)\right)\otimes\vec {\tilde u}\right\}\\
 +\left(d_{\vec
x}\vec {\tilde u}\cdot\left\{\frac{1}{\gamma_0\kappa_0
c^2}\left(\frac{\partial\vec A}{\partial t}+d_{\vec x}\vec
A\cdot\vec {\tilde u}+\left\{d_{\vec x}\vec {\tilde
u}\right\}^T\cdot\vec A\right)\right\}\right).
\end{multline}

Next, from now we assume that $\vec {\tilde u}$ varies sufficiently
slowly in space and time variables, so that the following
approximation is valid:
\begin{equation}\label{MaxVacFullPPNmmmffffffhhtygghGGGGyuhggghghghffgiooiiojhjjhhjuiuiugughhjggjjhkoououint}
\frac{\left| d_{\vec x}\vec {\tilde u}+\left\{d_{\vec x}\vec {\tilde
u}\right\}^T\right|}{ c}\ll \frac{|\nabla_{\vec x}\Psi_1|}{| \vec
A|+|\Psi_1|}+\frac{|\gamma_0\kappa_0||\Delta_{\vec x}\vec A|}{|
d_{\vec x}\vec A|+|\nabla_{\vec x}\Psi_1|}.
\end{equation}
In particular, if in some Cartesian coordinate system we have both
\begin{equation}\label{MaxVacFullPPNmmmffffffhhtygghGGGGyuhggghghghffgiooiiojhjjhhjuiuiugughhjggjjhkoououioioint}
\frac{\left| d_{\vec x}\vec {\tilde u}+\left\{d_{\vec x}\vec {\tilde
u}\right\}^T\right|^2}{
\left|\vec {\tilde u}\right|^2}\ll \left(\frac{|\nabla_{\vec
x}\Psi_1|}{| \vec
A|+|\Psi_1|}\right)^2+\left(\frac{|\gamma_0\kappa_0||\Delta_{\vec
x}\vec A|}{| d_{\vec x}\vec A|+|\nabla_{\vec x}\Psi_1|}\right)^2,
\end{equation}
and
\begin{equation}\label{MaxVacFullPPNmmmffffffiuiuhjuughbghhiuijghghghhjhjhhghyuyhhjhjihikjjjjkjljklghhggghiioioint}
\frac{\left|\vec {\tilde u}\right|^2}{
c^2}\ll 1,
\end{equation}
then
\er{MaxVacFullPPNmmmffffffhhtygghGGGGyuhggghghghffgiooiiojhjjhhjuiuiugughhjggjjhkoououint}
indeed holds! Furthermore, taking into the account
\er{MaxVacFullPPNmmmffffffhhtygghGGGGyuhggghghghffgiooiiojhjjhhjuiuiugughhjggjjhkoououint},
under the calibration \er{MaxVacFullPPNjjjjffhhGGGGGGint}, we
rewrite \er{MaxVacFullPPNmmmffffffhhtygghGGGGint} and
\er{MaxVacFullPPNnnnffffffyuughjhjhjhhjjkjhkkjhhjhghGGGGint} as
\begin{equation}\label{MaxVacFullPPNmmmffffffhhtygghGGGGggint}
-\Delta_{\vec x}\Psi_1\approx 4\pi\gamma_0\rho,
\end{equation}
and
\begin{multline}\label{MaxVacFullPPNnnnffffffyuughjhjhjhhjjkjhkkjhhjhghGGGGggint}
-\Delta_{\vec x}\vec A\approx \frac{4\pi}{\kappa_0 c}\left(\vec j-
\frac{1}{4\pi\gamma_0}\left(\frac{\partial}{\partial
t}\left\{\nabla_{\vec x}\Psi_1\right\}-curl_{\vec x}\left\{\vec
{\tilde u}\times \nabla_{\vec x}\Psi_1\right\}\right)\right)
\\-\frac{\partial}{\partial
t}\left\{\frac{1}{\gamma_0\kappa_0 c^2}\left(\frac{\partial\vec
A}{\partial t}+d_{\vec x}\vec A\cdot\vec {\tilde u}+\left\{d_{\vec
x}\vec {\tilde u}\right\}^T\cdot\vec A\right)\right\}\\- \Div_{\vec
x} \left\{\left(\frac{1}{\gamma_0\kappa_0
c^2}\left(\frac{\partial\vec A}{\partial t}+d_{\vec x}\vec
A\cdot\vec {\tilde u}+\left\{d_{\vec x}\vec
{\tilde u}\right\}^T\cdot\vec A\right)\right)\otimes\vec {\tilde u}\right\}\\
 +\left(d_{\vec
x}\vec {\tilde u}\cdot\left\{\frac{1}{\gamma_0\kappa_0
c^2}\left(\frac{\partial\vec A}{\partial t}+d_{\vec x}\vec
A\cdot\vec {\tilde u}+\left\{d_{\vec x}\vec {\tilde
u}\right\}^T\cdot\vec A\right)\right\}\right).
\end{multline}
Note that, using Proposition \ref{yghgjtgyrtrtint} we deduce that
the approximate equations
\er{MaxVacFullPPNmmmffffffhhtygghGGGGggint} and
\er{MaxVacFullPPNnnnffffffyuughjhjhjhhjjkjhkkjhhjhghGGGGggint} are
still invariant under the change of inertial or non-inertial
cartesian coordinate system, provided that $\vec A$ is a proper
vector field and $\Psi_1$ is a proper scalar field.

On the other hand, taking into the account
\er{MaxVacFullPPNmmmffffffhhtygghGGGGyuhggghghghffgiooiiojhjjhhjuiuiugughhjggjjhkoououint},
under the calibration \er{MaxVacFullPPNjjjjffhhGGint}, we rewrite
\er{MaxVacFullPPNmmmffffffiuiuhjuGGint} and
\er{MaxVacFullPPNnnnffffffyuughjhjhjhhjjkjhkkjhujgGGint} as
\begin{equation}\label{MaxVacFullPPNmmmffffffiuiuhjuGGggint}
\frac{1}{\kappa_0\gamma_0 c^2}\left(\frac{\partial}{\partial
t}\left(\frac{\partial\Psi_1}{\partial t}+\vec {\tilde
u}\cdot\nabla_{\vec x}\Psi_1\right)+div_{\vec x}
\left\{\left(\frac{\partial\Psi_1}{\partial t}+\vec {\tilde
u}\cdot\nabla_{\vec x}\Psi_1\right)\vec {\tilde
u}\right\}\right)-\Delta_{\vec x}\Psi_1\approx 4\pi\gamma_0\rho.
\end{equation}
and
\begin{multline}\label{MaxVacFullPPNnnnffffffyuughjhjhjhhjjkjhkkjhujgGGggint}
-\Delta_{\vec x}\vec A\approx \frac{4\pi}{\kappa_0 c}\left(\vec
j-\rho\vec {\tilde u}\right)-\frac{\partial}{\partial
t}\left\{\frac{1}{\gamma_0\kappa_0 c^2}\left(\frac{\partial\vec
A}{\partial t}+d_{\vec x}\vec A\cdot\vec {\tilde u}+\left\{d_{\vec
x}\vec {\tilde u}\right\}^T\cdot\vec A\right)\right\}\\- \Div_{\vec
x} \left\{\left(\frac{1}{\gamma_0\kappa_0
c^2}\left(\frac{\partial\vec A}{\partial t}+d_{\vec x}\vec
A\cdot\vec {\tilde u}+\left\{d_{\vec x}\vec
{\tilde u}\right\}^T\cdot\vec A\right)\right)\otimes\vec {\tilde u}\right\}\\
 +\left(d_{\vec
x}\vec {\tilde u}\cdot\left\{\frac{1}{\gamma_0\kappa_0
c^2}\left(\frac{\partial\vec A}{\partial t}+d_{\vec x}\vec
A\cdot\vec {\tilde u}+\left\{d_{\vec x}\vec {\tilde
u}\right\}^T\cdot\vec A\right)\right\}\right).
\end{multline}
Again note that, using Proposition \ref{yghgjtgyrtrtint} we deduce
that the approximate equations
\er{MaxVacFullPPNmmmffffffiuiuhjuGGggint} and
\er{MaxVacFullPPNnnnffffffyuughjhjhjhhjjkjhkkjhujgGGggint} are still
invariant under the change of inertial or non-inertial cartesian
coordinate system, provided that $\vec A$ is a proper vector field
and $\Psi_1$ is a proper scalar field.

In particular, assume that in some Cartesian coordinate system $(*)$
we have the following stronger than
\er{MaxVacFullPPNmmmffffffhhtygghGGGGyuhggghghghffgiooiiojhjjhhjuiuiugughhjggjjhkoououint}
approximation:
\begin{equation}\label{MaxVacFullPPNmmmffffffhhtygghGGGGyuhggghghint}
\frac{\left| d_{\vec x}\vec {\tilde u}\right|}{
c}\ll \frac{|\gamma_0\kappa_0|| \Delta_{\vec x}\vec A|
+\frac{1}{c}\left|d_{\vec x}\left\{\frac{\partial \vec A}{\partial
t}\right\}\right|+\frac{1}{c^2}\left|\frac{\partial^2 \vec
A}{\partial t^2}\right|+|\gamma_0\kappa_0|| \nabla^2_{\vec
x}\Psi_1|+\frac{1}{c}\left|\frac{\partial}{\partial
t}\left(\nabla_{\vec
x}\Psi_1\right)\right|+\frac{1}{c^2}\left|\frac{\partial^2
\Psi_1}{\partial t^2}\right|}{| d_{\vec x}\vec
A|+\frac{1}{c}\left|\frac{\partial \vec A}{\partial
t}\right|+|\nabla_{\vec x}\Psi_1|+\frac{1}{c}\left|\frac{\partial
\Psi_1}{\partial t}\right|},
\end{equation}
i.e. the field $\vec {\tilde u}$ changes in space much slower than
$(\Psi_1,\vec A)$. Estimation
\er{MaxVacFullPPNmmmffffffhhtygghGGGGyuhggghghint} holds especially
good for the electromagnetic waves of high frequency for example for
the visible light. However,
\er{MaxVacFullPPNmmmffffffhhtygghGGGGyuhggghghint} is still well for
almost every electromagnetic field we meet in the common life,
except probably the magnetic field of the Earth. Then, by
\er{MaxVacFullPPNmmmffffffiuiuhjuGGggint},
\er{MaxVacFullPPNnnnffffffyuughjhjhjhhjjkjhkkjhujgGGggint} and
\er{MaxVacFullPPNmmmffffffhhtygghGGGGyuhggghghint} we can write the
further approximating equations:
\begin{equation}\label{MaxVacFullPPNmmmffffffiuiuhjuGGggFGint}
\left(\frac{\partial}{\partial
t}\left\{\frac{1}{c^2_0}\left(\frac{\partial\Psi_1}{\partial t}+\vec
{\tilde u}\cdot\nabla_{\vec x}\Psi_1\right)\right\}+\Div_{\vec x}
\left\{\frac{1}{c^2_0}\left(\frac{\partial\Psi_1}{\partial t}+\vec
{\tilde u}\cdot\nabla_{\vec x}\Psi_1\right)\vec {\tilde
u}\right\}\right)-\Delta_{\vec x}\Psi_1 \approx 4\pi\gamma_0\rho,
\end{equation}
and
\begin{equation}\label{MaxVacFullPPNnnnffffffyuughjhjhjhhjjkjhkkjhujgGGggFGint}
\left(\frac{\partial}{\partial
t}\left\{\frac{1}{c^2_0}\left(\frac{\partial\vec A}{\partial
t}+d_{\vec x}\vec A\cdot\vec {\tilde u}\right)\right\}+\Div_{\vec x}
\left\{\frac{1}{c^2_0}\left(\frac{\partial\vec A}{\partial
t}+d_{\vec x}\vec A\cdot\vec {\tilde u}\right)\otimes\vec {\tilde
u}\right\}\right)-\Delta_{\vec x}\vec A\approx \frac{4\pi}{\kappa_0
c}\left(\vec j-\rho\vec {\tilde u}\right),
\end{equation}
where the scalar quantity $c_0$, defined by
\begin{equation}\label{gughhghfbvnbvint}
c_0=c\sqrt{\kappa_0\gamma_0},
\end{equation}
is called speed of light in the medium. Note that, although the
approximate equations \er{MaxVacFullPPNmmmffffffiuiuhjuGGggFGint}
and \er{MaxVacFullPPNnnnffffffyuughjhjhjhhjjkjhkkjhujgGGggFGint} are
invariant under the Galilean Transformation, they are not invariant
under the more general change of non-inertial cartesian coordinate
system. However, \er{MaxVacFullPPNmmmffffffiuiuhjuGGggFGint} and
\er{MaxVacFullPPNnnnffffffyuughjhjhjhhjjkjhkkjhujgGGggFGint} are
more convenient than \er{MaxVacFullPPNmmmffffffiuiuhjuGGggint} and
\er{MaxVacFullPPNnnnffffffyuughjhjhjhhjjkjhkkjhujgGGggint}, since
the scalar potential $\Psi_1$ and every of the three scalar
components of the vector potential $\vec A$ in
\er{MaxVacFullPPNmmmffffffiuiuhjuGGggFGint} and
\er{MaxVacFullPPNnnnffffffyuughjhjhjhhjjkjhkkjhujgGGggFGint}
satisfies four decoupled equations of the same type, that differ
only by the right parts. On the other hand, if we consider some
three proper vector fields $\vec e_1:=\vec e_1(\vec x,t)$, $\vec
e_2:=\vec e_2(\vec x,t)$, and $\vec e_3:=\vec e_3(\vec x,t)$, which
are mutually orthogonal to each other and satisfy the following
approximation:
\begin{equation}\label{MaxVacFullPPNmmmffffffhhtygghGGGGyuhggghghhjhjhint}
\frac{|d_{\vec x}\vec {e}_1|+c_0|\partial_t\vec e_1|}{|\vec
{e}_1|}+\frac{|d_{\vec x}\vec {e}_2|+c_0|\partial_t\vec e_2|}{|\vec
{e}_2|}+\frac{|d_{\vec x}\vec {e}_3|+c_0|\partial_t\vec e_3|}{|\vec
{e}_3|}\ll\frac{|\nabla_{\vec
x}\Psi_1|+c_0|\partial_t\Psi_1|+|d_{\vec x}\vec
A|+c_0|\partial_t\vec A|}{|\Psi_1|+|\vec A|}.
\end{equation}
in the given before coordinate system $(*)$, i.e. the field $\vec
{e}_k$ vary in space and time much weaker than $(\Psi_1,\vec A)$,
then we may write the alternative to
\er{MaxVacFullPPNnnnffffffyuughjhjhjhhjjkjhkkjhujgGGggFGint} and
\er{MaxVacFullPPNmmmffffffiuiuhjuGGggFGint} approximate equations in
the form of four decoupled scalar invariant wave equations of the
same type:
\begin{multline}\label{MaxVacFullPPNmmmffffffiuiuhjuGGggFGjjkkjjint}
\left(\frac{\partial}{\partial
t}\left\{\frac{1}{c^2_0}\left(\frac{\partial}{\partial t}\left(\vec
e_k\cdot\vec A\right)+\vec {\tilde u}\cdot\nabla_{\vec x}\left(\vec
e_k\cdot\vec A\right)\right)\right\}+\Div_{\vec x}
\left\{\frac{1}{c^2_0}\left(\frac{\partial}{\partial t}\left(\vec
e_k\cdot\vec A\right)+\vec {\tilde u}\cdot\nabla_{\vec x}\left(\vec
e_k\cdot\vec A\right)\right)\vec {\tilde
u}\right\}\right)\\-\Delta_{\vec x}\left(\vec e_k\cdot\vec A\right)
\,\approx\, \frac{4\pi}{\kappa_0 c}\left(\vec j-\rho\vec {\tilde
u}\right)\cdot\vec e_k\quad\quad\forall k=1,2,3,
\end{multline}
and
\begin{equation}\label{MaxVacFullPPNmmmffffffiuiuhjuGGggFGkjkljlint}
\left(\frac{\partial}{\partial
t}\left\{\frac{1}{c^2_0}\left(\frac{\partial\Psi_1}{\partial t}+\vec
{\tilde u}\cdot\nabla_{\vec x}\Psi_1\right)\right\}+\Div_{\vec x}
\left\{\frac{1}{c^2_0}\left(\frac{\partial\Psi_1}{\partial t}+\vec
{\tilde u}\cdot\nabla_{\vec x}\Psi_1\right)\vec {\tilde
u}\right\}\right)-\Delta_{\vec x}\Psi_1 \approx 4\pi\gamma_0\rho.
\end{equation}
Then, clearly, the new alternative approximate equations
\er{MaxVacFullPPNmmmffffffiuiuhjuGGggFGjjkkjjint},
\er{MaxVacFullPPNmmmffffffiuiuhjuGGggFGkjkljlint} are indeed
invariant under the more general change of non-inertial cartesian
coordinate system. So we can use approximate equations
\er{MaxVacFullPPNmmmffffffiuiuhjuGGggFGjjkkjjint} and
\er{MaxVacFullPPNmmmffffffiuiuhjuGGggFGkjkljlint} in the arbitrary
Cartesian coordinate system $(*)$ even if
\er{MaxVacFullPPNmmmffffffhhtygghGGGGyuhggghghint} and
\er{MaxVacFullPPNmmmffffffhhtygghGGGGyuhggghghhjhjhint} are not
satisfied in the system $(*)$, provided that
\er{MaxVacFullPPNmmmffffffhhtygghGGGGyuhggghghint} and
\er{MaxVacFullPPNmmmffffffhhtygghGGGGyuhggghghhjhjhint} are
satisfied in another Cartesian system $(**)$.

In the absence of charges and currents (for example for
electromagnetic waves) equations
\er{MaxVacFullPPNmmmffffffiuiuhjuGGggFGint} and
\er{MaxVacFullPPNnnnffffffyuughjhjhjhhjjkjhkkjhujgGGggFGint} become:
\begin{equation}\label{MaxVacFullPPNmmmffffffiuiuhjuGGggFGelint}
\left(\frac{\partial}{\partial
t}\left\{\frac{1}{c^2_0}\left(\frac{\partial\Psi_1}{\partial t}+\vec
{\tilde u}\cdot\nabla_{\vec x}\Psi_1\right)\right\}+\Div_{\vec x}
\left\{\frac{1}{c^2_0}\left(\frac{\partial\Psi_1}{\partial t}+\vec
{\tilde u}\cdot\nabla_{\vec x}\Psi_1\right)\vec {\tilde
u}\right\}\right)-\Delta_{\vec x}\Psi_1=0,
\end{equation}
and
\begin{equation}\label{MaxVacFullPPNnnnffffffyuughjhjhjhhjjkjhkkjhujgGGggFGelint}
\left(\left\{\frac{1}{c^2_0}\frac{\partial}{\partial
t}\left(\frac{\partial\vec A}{\partial t}+d_{\vec x}\vec A\cdot\vec
{\tilde u}\right)\right\}+\Div_{\vec x}
\left\{\frac{1}{c^2_0}\left(\frac{\partial\vec A}{\partial
t}+d_{\vec x}\vec A\cdot\vec {\tilde u}\right)\otimes\vec {\tilde
u}\right\}\right)-\Delta_{\vec x}\vec A=0,
\end{equation}
and equations \er{MaxVacFullPPNmmmffffffiuiuhjuGGggFGjjkkjjint},
\er{MaxVacFullPPNmmmffffffiuiuhjuGGggFGkjkljlint} become:
\begin{multline}\label{MaxVacFullPPNmmmffffffiuiuhjuGGggFGjjkkjjkojjkint}
\left(\frac{\partial}{\partial
t}\left\{\frac{1}{c^2_0}\left(\frac{\partial}{\partial t}\left(\vec
e_k\cdot\vec A\right)+\vec {\tilde u}\cdot\nabla_{\vec x}\left(\vec
e_k\cdot\vec A\right)\right)\right\}+\Div_{\vec x}
\left\{\frac{1}{c^2_0}\left(\frac{\partial}{\partial t}\left(\vec
e_k\cdot\vec A\right)+\vec {\tilde u}\cdot\nabla_{\vec x}\left(\vec
e_k\cdot\vec A\right)\right)\vec {\tilde
u}\right\}\right)\\-\Delta_{\vec x}\left(\vec e_k\cdot\vec A\right)
\,\approx\, 0\quad\quad\forall k=1,2,3,
\end{multline}
and
\begin{equation}\label{MaxVacFullPPNmmmffffffiuiuhjuGGggFGkjkljlhint}
\left(\frac{\partial}{\partial
t}\left\{\frac{1}{c^2_0}\left(\frac{\partial\Psi_1}{\partial t}+\vec
{\tilde u}\cdot\nabla_{\vec x}\Psi_1\right)\right\}+\Div_{\vec x}
\left\{\frac{1}{c^2_0}\left(\frac{\partial\Psi_1}{\partial t}+\vec
{\tilde u}\cdot\nabla_{\vec x}\Psi_1\right)\vec {\tilde
u}\right\}\right)-\Delta_{\vec x}\Psi_1 \approx 0.
\end{equation}
Therefore, by \er{vhfffngghPPNffGGint}, differentiating
\er{MaxVacFullPPNmmmffffffiuiuhjuGGggFGelint} and
\er{MaxVacFullPPNnnnffffffyuughjhjhjhhjjkjhkkjhujgGGggFGelint} or
\er{MaxVacFullPPNmmmffffffiuiuhjuGGggFGjjkkjjkojjkint} and
\er{MaxVacFullPPNmmmffffffiuiuhjuGGggFGkjkljlhint} and further usage
of \er{MaxVacFullPPNmmmffffffhhtygghGGGGyuhggghghint} and
\er{MaxVacFullPPNmmmffffffhhtygghGGGGyuhggghghhjhjhint} gives that
if the scalar field $U:=U(\vec x,t)$ is one of any three scalar
components of every of the fields $\vec E$, $\vec B$, $\vec D$ or
$\vec H$, then $U$ satisfies the following approximate scalar
equation of the wave type:
\begin{equation}\label{MaxVacFullPPNmmmffffffiuiuhjuGGggFGelGHGHGHGGint}
\left(\frac{\partial}{\partial
t}\left\{\frac{1}{c^2_0}\left(\frac{\partial U}{\partial t}+\vec
{\tilde u}\cdot\nabla_{\vec x}U\right)\right\}+\Div_{\vec x}
\left\{\frac{1}{c^2_0}\left(\frac{\partial U}{\partial t}+\vec
{\tilde u}\cdot\nabla_{\vec x} U\right)\vec {\tilde
u}\right\}\right)-\Delta_{\vec x}U \approx 0,
\end{equation}
where,
\begin{equation}\label{uyuyuyyint}
\vec {\tilde u}=\left(\gamma_0\kappa_0\vec
v+\left(1-\gamma_0\kappa_0\right)\vec u\right)\,.
\end{equation}
Note that the wave equation
\er{MaxVacFullPPNmmmffffffiuiuhjuGGggFGelGHGHGHGGint} is of the same
type as
\er{MaxVacFull1ninshtrgravortghhghgjkgghklhjgkghghjjkjhjkkggjkhjkhjjhhfhjhkjhjjhkhbbgjlmkmmhzzzyyykkknnnNWBWHWPPNjjlllkkkkkkljkkjjkjkhjjhhgghgggghjjhgjgghint},
\er{MaxVacFull1ninshtrgravortghhghgjkgghklhjgkghghjjkjhjkkggjkhjkhjjhhfhjhkjhjjhkhbbgjlmkmmhzzzyyykkknnnNWBWHWPPNjjlllkkkkkkljkkjjkjkhjjhhgghgggghjjhgjgghjkjkjkint},
\er{MaxVacFull1ninshtrgravortghhghgjkgghklhjgkghghjjkjhjkkggjkhjkhjjhhfhjhkjkhbbgjhzzzyyykkknnnNWBWHWPPNjjjljkjkkhhhhhhhhhhhhfnfffhuyuyuyutjhjhjhggjjfgfhfhjhhgjgjhkjhjhkhkhjjint}
or \er{MaxVacFullPPNmmmffffffiuiuhjuGGggFGjjkkjjffjjint}, with the
only difference that the field $\vec {\tilde u}$ appear in
\er{MaxVacFullPPNmmmffffffiuiuhjuGGggFGelGHGHGHGGint} instead of the
field $\vec u_0$ in
\er{MaxVacFull1ninshtrgravortghhghgjkgghklhjgkghghjjkjhjkkggjkhjkhjjhhfhjhkjhjjhkhbbgjlmkmmhzzzyyykkknnnNWBWHWPPNjjlllkkkkkkljkkjjkjkhjjhhgghgggghjjhgjgghint},
\er{MaxVacFull1ninshtrgravortghhghgjkgghklhjgkghghjjkjhjkkggjkhjkhjjhhfhjhkjhjjhkhbbgjlmkmmhzzzyyykkknnnNWBWHWPPNjjlllkkkkkkljkkjjkjkhjjhhgghgggghjjhgjgghjkjkjkint}.
\er{MaxVacFull1ninshtrgravortghhghgjkgghklhjgkghghjjkjhjkkggjkhjkhjjhhfhjhkjkhbbgjhzzzyyykkknnnNWBWHWPPNjjjljkjkkhhhhhhhhhhhhfnfffhuyuyuyutjhjhjhggjjfgfhfhjhhgjgjhkjhjhkhkhjjint}
or \er{MaxVacFullPPNmmmffffffiuiuhjuGGggFGjjkkjjffjjint}.

\subsubsection{The case of quasistationary electromagnetic fields
inside a slowly moving medium in a weak gravitational
field}
Assume that in the given inertial or non-inertial
cartesian coordinate system $(*)$ the field $\vec {\tilde u}$ is
weak, meaning that at any instant on every point:
\begin{equation}\label{slowaetherGGaaint}
\frac{1}{\kappa_0\gamma_0}\frac{|\vec {\tilde u}|^2}{c^2}\,\ll\, 1.
\end{equation}
Here $\vec {\tilde u}=\left(\gamma_0\kappa_0\vec
v+(1-\gamma_0\kappa_0)\vec u\right)$ is the speed-like vector field,
where $\vec v$ is a vectorial gravitational potential in the system
$(*)$ and $\vec u$ is the medium velocity. Furthermore, consider
quasistationary electromagnetic fields. This means the following:
assume that the changes in time of the physical characteristics of
the electromagnetic fields become essential after certain interval
of time $T_e$ and the changes in space of the physical
characteristics of the fields become essential in the spatial
landscape $L_e$. Then we assume that
\begin{equation}\label{slochangGGaaint}
(\kappa_0\gamma_0)\frac{c^2T^2_e}{L^2_e}\,\gg\, 1.
\end{equation}
Next assume that we are under the calibration
\er{MaxVacFullPPNjjjjffhhGGGGGGint}. Then by \er{slowaetherGGaaint}
and \er{slochangGGaaint} we rewrite
\er{MaxVacFullPPNmmmffffffhhtygghGGGGint} and
\er{MaxVacFullPPNnnnffffffyuughjhjhjhhjjkjhkkjhhjhghGGGGint} as
\begin{equation}\label{MaxVacFullPPNmmmffffffhhtygghGGGGaaint}
-\Delta_{\vec x}\Psi_1=4\pi\gamma_0\rho+\frac{1}{c}\,div_{\vec x}
\left\{\left(d_{\vec x}\vec {\tilde u}+\left\{d_{\vec x}\vec {\tilde
u}\right\}^T\right)\cdot\vec A-\left(div_{\vec x}\vec {\tilde
u}\right)\vec A\right\},
\end{equation}
and
\begin{equation}\label{MaxVacFullPPNnnnffffffyuughjhjhjhhjjkjhkkjhhjhghGGGGaa1int}
-\Delta_{\vec x}\vec A\approx \frac{4\pi}{\kappa_0 c}\left(\vec
j-\rho\vec {\tilde u}\right)-\frac{1}{\kappa_0\gamma_0
c}\left(\frac{\partial}{\partial t}\left(\nabla_{\vec
x}\Psi_1\right)-curl_{\vec x}\left(\vec {\tilde u}\times\nabla_{\vec
x}\Psi_1\right)+\left(\Delta_{\vec x}\Psi_1\right)\vec {\tilde
u}\right).
\end{equation}
Moreover, by \er{slowaetherGGaaint} and \er{slochangGGaaint} we can
perform further approximation of
\er{MaxVacFullPPNnnnffffffyuughjhjhjhhjjkjhkkjhhjhghGGGGaa1int} and
we get
\begin{equation}\label{MaxVacFullPPNnnnffffffyuughjhjhjhhjjkjhkkjhhjhghGGGGaaint}
-\Delta_{\vec x}\vec A
\approx\frac{4\pi}{\kappa_0 c}\,\vec
j-\frac{1}{\kappa_0 c}\left(\frac{\partial}{\partial
t}\left(\nabla_{\vec x}\psi_0\right)-curl_{\vec x}\left(\vec
v\times\nabla_{\vec x}\psi_0\right)\right),
\end{equation}
where $\psi_0(\vec x,t)$ is the classical Coulomb's potential which
satisfies
\begin{equation}\label{columbPPNaaint}
-\Delta_{\vec x}\psi_0\equiv 4\pi\rho.
\end{equation}
So we rewrite \er{MaxVacFullPPNmmmffffffhhtygghGGGGaaint} and
\er{MaxVacFullPPNnnnffffffyuughjhjhjhhjjkjhkkjhhjhghGGGGaaint} as
\begin{equation}\label{MaxVacFull1bjkgjhjhgjgjgkjfhjfdghcgjhhjgkgkgugyyurhjfffhfjklhhhgkjgGGaaKKint}
\begin{cases}
-\Delta_{\vec x}\vec A \approx\frac{4\pi}{\kappa_0 c}\vec {\widetilde j},\\
-\Delta_{\vec x}\Psi_1=4\pi\gamma_0\rho+\frac{1}{c}\,div_{\vec x}
\left\{\left(d_{\vec x}\vec {\tilde u}+\left\{d_{\vec x}\vec {\tilde
u}\right\}^T\right)\cdot\vec A-\left(div_{\vec x}\vec {\tilde
u}\right)\vec A\right\},
\end{cases}
\end{equation}
where we set the reduced current:
\begin{equation}\label{reducedcurrentfhfhjfhjGGaaint}
\begin{cases}
\vec {\widetilde j}:=\vec j-\frac{1}{4\pi}\frac{\partial}{\partial
t} \left(\nabla_{\vec x}\psi_0\right)+\frac{1}{4\pi}curl_{\vec
x}\left(\vec {\tilde u}\times \nabla_{\vec x}\psi_0\right),\\
-\Delta_{\vec x}\psi_0= 4\pi\rho.
\end{cases}
\end{equation}
Note that by the Continuum Equation of the Conservation of Charges:
\begin{equation}\label{toksohraneniezarjadaPPNaaint}
\frac{\partial\rho}{\partial t}+div_{\vec x}\vec j\equiv 0,
\end{equation}
the reduced current clearly satisfies:
\begin{equation}\label{divreducedcurrentPPNaaint}
div_{\vec x}\vec {\widetilde j}\equiv 0.
\end{equation}
Moreover,
using Proposition
\ref{yghgjtgyrtrtint}
we can easily deduce that $\vec{\widetilde j}$ is a proper vector
field (see subsection \ref{qfCM} for detatls). Finally, the
approximate vectorial electromagnetic potential $\vec A$ from
\er{MaxVacFull1bjkgjhjhgjgjgkjfhjfdghcgjhhjgkgkgugyyurhjfffhfjklhhhgkjgGGaaKKint}
clearly satisfies:
\begin{equation}\label{MaxVacFull1bjkgjhjhgjgjgkjfhjfdghcgjhhjgkgkgugyyurkkkGGGGGaaint}
div_{\vec x}\vec A=0.
\end{equation}
Next, since by \er{vhfffngghhjghhgPPNghghghutghffugghjhjkjjklggint}
we have
\begin{equation}\label{vhfffngghhjghhgPPNghghghutghffugghjhjkjjklgghkhhhint}
\Psi_1:=\Psi-\frac{1}{c}\vec A\cdot\vec {\tilde u},
\end{equation}
%
%
%
%
%
%
we rewrite
\er{MaxVacFull1bjkgjhjhgjgjgkjfhjfdghcgjhhjgkgkgugyyurhjfffhfjklhhhgkjgGGaaKKint}
as:
\begin{equation}\label{MaxVacFull1bjkgjhjhgjgjgkjfhjfdghcgjhhjgkgkgugyyurhjfffhfjklhhhgkjgGGaaint}
\begin{cases}
-\Delta_{\vec x}\vec A \approx\frac{4\pi}{\kappa_0 c}\vec {\widetilde j},\\
-\Delta_{\vec x}\Psi= 4\pi\gamma_0\rho-\frac{1}{c}\,div_{\vec
x}\left(\vec {\tilde u}\times curl_{\vec x} \vec A\right).
\end{cases}
\end{equation}
where
\begin{equation}\label{reducedcurrentfhfhjfhjGGaakklklint}
\begin{cases}
\vec {\widetilde j}:=\vec j-\frac{1}{4\pi}\frac{\partial}{\partial
t} \left(\nabla_{\vec x}\psi_0\right)+\frac{1}{4\pi}curl_{\vec
x}\left(\vec {\tilde u}\times \nabla_{\vec x}\psi_0\right),\\
-\Delta_{\vec x}\psi_0= 4\pi\rho,
\end{cases}
\end{equation}
(see subsection \ref{qfCM} for details). So in order to find the
scalar and the vectorial electromagnetic potentials we just need to
solve Laplace equations. Knowing the approximate electromagnetic
potentials by \er{vhfffngghPPN333yuyuGGint} we can find the
approximations of of the electromagnetic fields:
\begin{equation}\label{MaxVacFull1bjkgjhjhgjgjgkjfhjfdghghligioiuittrhiguffGGaaint}
\begin{cases}
\vec B= curl_{\vec x} \vec A\\
\vec E=-\nabla_{\vec x}\Psi-\frac{1}{c}\frac{\partial\vec
A}{\partial t}\\
 \vec D=-\frac{1}{\gamma_0}\nabla_{\vec
x}\Psi-\frac{1}{\gamma_0 c}\frac{\partial\vec A}{\partial t}+\frac{1}{c\gamma_0}\vec {\tilde u}\times curl_{\vec x}\vec A\\
\vec H=\kappa_0 \,curl_{\vec x} \vec A+\frac{1}{c}\,\vec {\tilde
u}\times\left(-\frac{1}{\gamma_0}\nabla_{\vec
x}\Psi-\frac{1}{\gamma_0 c}\frac{\partial\vec A}{\partial
t}+\frac{1}{\gamma_0 c}\vec {\tilde u}\times curl_{\vec x}\vec
A\right),
\end{cases}
\end{equation}
where $\Psi$ and $\vec A$ are given by
\er{MaxVacFull1bjkgjhjhgjgjgkjfhjfdghcgjhhjgkgkgugyyurhjfffhfjklhhhgkjgGGaaint}.
Note also that, since $\vec {\widetilde j}$ is a proper vector
field, by Proposition \ref{yghgjtgyrtrtint} we deduce that equations
\er{MaxVacFull1bjkgjhjhgjgjgkjfhjfdghcgjhhjgkgkgugyyurhjfffhfjklhhhgkjgGGaaKKint}
and thus also equations
\er{MaxVacFull1bjkgjhjhgjgjgkjfhjfdghcgjhhjgkgkgugyyurhjfffhfjklhhhgkjgGGaaint}
are invariant under the change of non-inertial cartesian coordinate
system, provided that $\vec A$ is a proper vector field and
$\Psi_1=\Psi-\frac{1}{c}\vec A\cdot\vec {\tilde u}$ is a proper
scalar field. So the approximate solutions in the case of
quasistationary fields in a weak gravitational field satisfy the
same transformation as the exact solutions of Maxwell Equations.
Therefore, if in coordinate system $(*)$ we can use the approximate
equations, given by
\er{MaxVacFull1bjkgjhjhgjgjgkjfhjfdghcgjhhjgkgkgugyyurhjfffhfjklhhhgkjgGGaaint}
and
\er{MaxVacFull1bjkgjhjhgjgjgkjfhjfdghghligioiuittrhiguffGGaaint},
then we can use the similar approximation
also in coordinate system $(**)$, even in the case when in system
$(**)$ \er{slowaetherGGaaint} or \er{slochangGGaaint} are not
satisfied.
\begin{remark}
The solutions of
\er{MaxVacFull1bjkgjhjhgjgjgkjfhjfdghcgjhhjgkgkgugyyurhjfffhfjklhhhgkjgGGaaint}
and \er{MaxVacFull1bjkgjhjhgjgjgkjfhjfdghghligioiuittrhiguffGGaaint}
satisfy the following equations:
\begin{equation}\label{MaxVacFull1bjkgjhjhgjaaajhfghhgGGaaint}
\begin{cases}
curl_{\vec x} \left(\kappa_0\vec B+\frac{1}{c}\,\vec {\tilde
u}\times \left(- \nabla_{\vec x}\psi_0\right)\right)\equiv
\frac{4\pi}{c}\vec j+\frac{1}{c}\frac{\partial (-
\nabla_{\vec x}\psi_0)}{\partial t},\\
div_{\vec x} \vec D=4\pi\rho,\\
curl_{\vec x} \vec E+\frac{1}{c}\frac{\partial \vec B}{\partial t}=0,\\
div_{\vec x} \vec B=0\\
\vec E=\gamma_0\vec D-\frac{1}{c}\,\vec {\tilde u}\times \vec B\\
\vec H=\kappa_0\vec B+\frac{1}{c}\,\vec {\tilde u}\times \vec D,\\
\vec {\tilde u}=\left(\gamma_0\kappa_0\vec
v+(1-\gamma_0\kappa_0)\vec u\right),
\end{cases}
\end{equation}
where $\psi_0$ was defined by \er{columbPPNaaint}. Equations
\er{MaxVacFull1bjkgjhjhgjaaajhfghhgGGaaint} differ from the original
Maxwell equations \er{MaxVacFullPPNffGGint} only by neglecting the
divergence-free part of the vector field $\vec D$ on the first
equation.
\end{remark}

Next, assume that, in addition to the validity of approximation
\er{slowaetherGGaaint} and \er{slochangGGaaint}, the approximation
\er{MaxVacFullPPNmmmffffffhhtygghGGGGyuhggghghint} also holds. Then
we further approximate
\er{MaxVacFull1bjkgjhjhgjgjgkjfhjfdghcgjhhjgkgkgugyyurhjfffhfjklhhhgkjgGGaaKKint}
as:
\begin{equation}\label{MaxVacFull1bjkgjhjhgjgjgkjfhjfdghcgjhhjgkgkgugyyurhjfffhfjklhhhgkjgGGaaKKjkjjint}
\begin{cases}
-\Delta_{\vec x}\Psi_1=4\pi\gamma_0\rho,\\
-\Delta_{\vec x}\vec A \approx \frac{4\pi}{\kappa_0 c}\,\vec
j-\frac{1}{\kappa_0\gamma_0 c}\left(\frac{\partial}{\partial
t}\left(\nabla_{\vec
x}\Psi_1\right)-curl_{\vec x}\left(\vec {\tilde u}\times\nabla_{\vec x}\Psi_1\right)\right)\\
\Psi=\Psi_1+\frac{1}{c}\vec A\cdot\vec {\tilde u}.
\end{cases}
\end{equation}
Moreover, as before, we deduce that equations
\er{MaxVacFull1bjkgjhjhgjgjgkjfhjfdghcgjhhjgkgkgugyyurhjfffhfjklhhhgkjgGGaaKKjkjjint}
are also invariant under the change of non-inertial cartesian
coordinate system. Therefore, as before, if in coordinate system
$(*)$ we can use the approximation equations, given by
\er{MaxVacFull1bjkgjhjhgjgjgkjfhjfdghcgjhhjgkgkgugyyurhjfffhfjklhhhgkjgGGaaKKjkjjint}
then we can use the similar equations
also in coordinate system $(**)$, even in the case when in system
$(**)$ \er{slowaetherGGaaint}, \er{slochangGGaaint} or
\er{MaxVacFullPPNmmmffffffhhtygghGGGGyuhggghghint} are not
satisfied.

Finally, assume that we are under the alternative calibration
\er{MaxVacFullPPNjjjjffhhGGint}. Then by \er{slowaetherGGaaint} and
\er{slochangGGaaint} we rewrite
\er{MaxVacFullPPNmmmffffffiuiuhjuGGint} and
\er{MaxVacFullPPNnnnffffffyuughjhjhjhhjjkjhkkjhujgGGint} as:
\begin{equation}\label{MaxVacFullPPNmmmffffffiuiuhjuGGaaint}
-\Delta_{\vec x}\Psi_1\approx
4\pi\gamma_0\rho+\frac{1}{c}\,div_{\vec x} \left\{\left(d_{\vec
x}\vec {\tilde u}+\left\{d_{\vec x}\vec {\tilde
u}\right\}^T\right)\cdot\vec A-\left(div_{\vec x}\vec {\tilde
u}\right)\vec A\right\},
\end{equation}
and
\begin{equation}\label{MaxVacFullPPNnnnffffffyuughjhjhjhhjjkjhkkjhujgGGaaint}
-\Delta_{\vec x}\vec A\approx \frac{4\pi}{\kappa_0 c}\left(\vec
j-\rho\vec {\tilde u}\right)+\frac{1}{\kappa_0\gamma_0
c}\left(\left(d_{\vec x}\vec {\tilde u}+\left\{d_{\vec x}\vec
{\tilde u}\right\}^T\right)\cdot \nabla_{\vec
x}\Psi_1-\left(div_{\vec x}\vec {\tilde u}\right)\nabla_{\vec
x}\Psi_1\right).
\end{equation}
Thus if we assume that in addition to the approximation
\er{slowaetherGGaaint} and \er{slochangGGaaint} the approximation
\er{MaxVacFullPPNmmmffffffhhtygghGGGGyuhggghghint} also holds, we
further approximate \er{MaxVacFullPPNmmmffffffiuiuhjuGGaaint} and
\er{MaxVacFullPPNnnnffffffyuughjhjhjhhjjkjhkkjhujgGGaaint} as:
\begin{equation}\label{MaxVacFullPPNmmmffffffiuiuhjuGGaaghgghghint}
\begin{cases}
-\Delta_{\vec x}\Psi_1\approx 4\pi\gamma_0\rho,
\\
-\Delta_{\vec x}\vec A\approx \frac{4\pi}{\kappa_0 c}\left(\vec
j-\rho\vec {\tilde u}\right)\\
\Psi=\Psi_1+\frac{1}{c}\vec A\cdot\vec {\tilde u}.
\end{cases}
\end{equation}
Moreover, as before, we deduce that equations
\er{MaxVacFullPPNmmmffffffiuiuhjuGGaaghgghghint} are also invariant
under the change of non-inertial cartesian coordinate system.
Therefore, as before, if in coordinate system $(*)$ we can use the
approximation equations, given by
\er{MaxVacFullPPNmmmffffffiuiuhjuGGaaghgghghint}
then we can use the similar equations
also in coordinate system $(**)$, even in the case when in system
$(**)$ \er{slowaetherGGaaint}, \er{slochangGGaaint} or
\er{MaxVacFullPPNmmmffffffhhtygghGGGGyuhggghghint} are not
satisfied.

\subsubsection{Hertz's notation for quasistationary electromagnetic fields}\label{gcCMuhhjhgghint} Again consider in some domain the full system of
Maxwell equations in the vacuum or in a medium, without dispersion,
having the form \er{MaxVacFullPPNffGGint}:
\begin{equation}\label{MaxVacFullPPNffGGvbccgcint}
\begin{cases}
curl_{\vec x} \vec H=\frac{4\pi}{c}\vec j+
\frac{1}{c}\frac{\partial \vec D}{\partial t},\\
div_{\vec x} \vec D=4\pi\rho,\\
curl_{\vec x} \vec E+\frac{1}{c}\frac{\partial \vec B}{\partial t}=0,\\
div_{\vec x} \vec B=0\\
\vec E=\gamma_0\vec D-\frac{1}{c}\,\vec {\tilde u}\times \vec B\\
\vec H=\kappa_0\vec B+\frac{1}{c}\,\vec {\tilde u}\times \vec D,\\
\vec {\tilde u}=\left(\gamma_0\kappa_0\vec
v+(1-\gamma_0\kappa_0)\vec u\right).
\end{cases}
\end{equation}
Next define the electromagnetic fields in Hertz's notation:
\begin{equation}\label{MaxVacFullPPNffGGvbccgcjhhhjjihihjhint}
\begin{cases}
\vec{\tilde{ E}}:=\vec E+\frac{1}{c}\,\vec {u}\times \vec B,\\
\vec{\tilde{ D}}:=\frac{1}{\gamma_0}\vec{\tilde{ E}}=\frac{1}{\gamma_0}\left(\vec E+\frac{1}{c}\,\vec {u}\times \vec B\right),\\
\vec{\tilde{ B}}:=\frac{1}{\kappa_0}\left(\vec H-\frac{1}{c}\,\vec {u}\times \vec D\right),\\
\vec{\tilde{ H}}:=\kappa_0\vec{\tilde{ B}}=\left(\vec
H-\frac{1}{c}\,\vec {u}\times \vec D\right).
\end{cases}
\end{equation}
Note here that $\vec u$ is the velocity field in the medium, thus
since this field is absent in vacuum we use the convention that for
the vacuum we have $\vec u:=\vec v$ (and in addition $\vec {\tilde
u}=\vec v$). In particular, by
\er{MaxVacFullPPNffGGvbccgcjhhhjjihihjhint} and
\er{MaxVacFullPPNffGGvbccgcint} we have
\begin{equation}\label{MaxVacFullPPNffGGvbccgcjhhhjjihihjhhhhiuu8hjiyint}
\begin{cases}
\vec{\tilde{ E}}=\gamma_0\vec D+\frac{1}{c}\,\left(\vec {u}-\vec {\tilde u}\right)\times \vec B=\gamma_0\left(\vec D+\frac{\kappa_0}{c}\,\left(\vec {u}-\vec v\right)\times \vec B\right),\\
\vec{\tilde{ D}}=\vec D+\frac{\kappa_0}{c}\,\left(\vec {u}-\vec v\right)\times \vec B,\\
\vec{\tilde{ B}}=\frac{1}{\kappa_0}\left(\kappa_0\vec B-\frac{1}{c}\,\left(\vec {u}-\vec {\tilde u}\right)\times \vec D\right)=\vec B-\frac{\gamma_0}{c}\,\left(\vec {u}-\vec v\right)\times \vec D,\\
\vec{\tilde{ H}}=\kappa_0\vec
B-\frac{\gamma_0\kappa_0}{c}\,\left(\vec {u}-\vec v\right)\times
\vec D.
\end{cases}
\end{equation}
Therefore, since $\vec D,\vec B, (\vec u-\vec v)$ are all proper
vector fields, we deduce that $\vec{\tilde{ E}}, \vec{\tilde{ D}},
\vec{\tilde{ B}}, \vec{\tilde{ H}}$ are all \underline{proper}
vector fields.
Next, analogously with \er{slowaetherGGaaint}, assume  that in the
given inertial or non-inertial cartesian coordinate system $(*)$ the
fields $\vec u$, $\vec v$, $\vec {\tilde u}$ are weak, meaning that
at any instant on every point:
\begin{equation}\label{slowaetherGGaagghghint}
\frac{1}{\kappa_0\gamma_0}\left(\frac{|\vec {
u}|^2}{c^2}+\frac{|\vec v|^2}{c^2}+\frac{|\vec {\tilde
u}|^2}{c^2}\right)\,\ll\, 1,
\end{equation}
and consider quasistationary electromagnetic fields, meaning the
following: assume that the changes in time of the physical
characteristics of the electromagnetic fields become essential after
certain interval of time $T_e$ and the changes in space of the
physical characteristics of the fields become essential in the
spatial landscape $L_e$. Then, analogously with
\er{slochangGGaaint}, we assume that
\begin{equation}\label{slochangGGaajhhjhjint}
(\kappa_0\gamma_0)\frac{c^2T^2_e}{L^2_e}\,\gg\, 1.
\end{equation}
Moreover, assume that we have the following approximation analogous
to \er{MaxVacFullPPNmmmffffffhhtygghGGGGyuhggghghint}: if the
changes in space of the physical characteristics of the
electromagnetic fields become essential in the spatial landscape
$L_e$ and the changes in space of the fields $\vec u$, $\vec v$,
$\vec {\tilde u}$ become essential in the spatial landscape $L_{u}$,
then we assume
\begin{equation}\label{MaxVacFullPPNmmmffffffhhtygghGGGGyuhggghghffggfgfint}
L_e\ll L_u,
\end{equation}
i.e. the fields $\vec u$, $\vec v$, $\vec {\tilde u}$ change in
space much slower than the electromagnetic fields.
%
%
%
%
%
%
%
%
Then, denoting
\begin{equation}\label{MaxVacFullPPNffGGvbccgcjhhhjjihihjhkjkhkhhkint}
\begin{cases}
\tilde\rho^*:=\rho-\frac{1}{c^2}\left(\vec {u}-\vec
v\right)\cdot\left(\vec j-\rho\vec u\right),\\
\vec {\tilde j}^*=\left(\vec j-\rho\vec u\right)+\tilde\rho^*\vec u,
\end{cases}
\end{equation}
by \er{slowaetherGGaagghghint}, \er{slochangGGaajhhjhjint} and
\er{MaxVacFullPPNmmmffffffhhtygghGGGGyuhggghghffggfgfint}, using the
equality of the conservation of the charge
$\frac{\partial\rho}{\partial t}+div_{\vec x}\vec j=0$ we deduce:
\begin{equation}\label{MaxVacFullPPNffGGvbccgcjhhhjjihihjhkjkhkhhkhkhhjggint}
\begin{cases}
\vec j\approx\vec {\tilde j^*},
\\
\frac{\partial\tilde\rho^*}{\partial t}+div_{\vec x}\vec {\tilde
j^*}\approx 0.
\end{cases}
\end{equation}
Moreover, by \er{MaxVacFullPPNffGGvbccgcjhhhjjihihjhkjkhkhhkint} we
easily obtain that $\tilde\rho^*$ is a proper scalar and $\left(\vec
{\tilde j^*}-\tilde\rho^*\vec u\right)$ is a proper vector.
%
%
%
%
%
%
%
%
Finally, by \er{slowaetherGGaagghghint}, \er{slochangGGaajhhjhjint},
\er{MaxVacFullPPNmmmffffffhhtygghGGGGyuhggghghffggfgfint} and
\er{MaxVacFullPPNffGGvbccgcjhhhjjihihjhkjkhkhhkint},
\er{MaxVacFullPPNffGGvbccgcjhhhjjihihjhkjkhkhhkhkhhjggint},
\er{MaxVacFullPPNffGGvbccgcjhhhjjihihjhint} we deduce from the
original equations \er{MaxVacFullPPNffGGvbccgcint} the following
approximate Maxwell equations in Hertz's form and write them
as
the following two sets of equations:
\begin{equation}\label{MaxVacFullPPNffGGvbccgcu8uuuuukjjkhhhihjhjkhhkmgjhjhhkhjhhjjhjhjjhint}
\begin{cases}
curl_{\vec x} \left(\kappa_0\vec{\tilde{
B}}\right)\approx\frac{4\pi}{c}\vec {\tilde j^*}+
\frac{1}{c}\frac{\partial }{\partial
t}\left(\frac{1}{\gamma_0}\vec{\tilde{
E}}\right)-\frac{1}{c}\,curl_{\vec x}\left(\vec
{u}\times\frac{1}{\gamma_0} \vec{\tilde{
E}}\right),\\
div_{\vec x} \left(\frac{1}{\gamma_0}\vec{\tilde{ E}}\right)\approx
4\pi\tilde\rho^*,\\
curl_{\vec x} \vec{\tilde{
E}}\approx-\left(\frac{1}{c}\frac{\partial \vec{\tilde{
B}}}{\partial t}-\frac{1}{c}\,curl_{\vec x}\left(\vec {u}\times
\vec{\tilde{
B}}\right)\right),\\
div_{\vec x} \vec{\tilde{ B}}\approx 0,\\
\frac{\partial\tilde\rho^*}{\partial t}+div_{\vec x}\vec {\tilde
j^*}\approx 0,
\end{cases}
\end{equation}
and
\begin{equation}\label{MaxVacFullPPNffGGvbccgcu8uuuuukjjkhhhihjhjkhhkmgjhjhhkhjhhjjhjhjjh11int}
\begin{cases}
\rho=\tilde\rho^*+\frac{1}{c^2}\left(\vec {u}-\vec
v\right)\cdot\left(\vec {\tilde j^*}-\tilde\rho^*\vec u\right),\\
\vec j=\left(\vec {\tilde j^*}-\tilde\rho^*\vec u\right)+\rho\vec u,
\\
\vec B\approx\vec{\tilde{ B}}+\frac{1}{c}\,\left(\vec {u}-\vec
v\right)\times \vec{\tilde{ E}}\\
\vec E=\vec{\tilde{ E}}-\frac{1}{c}\,\vec {u}\times \vec{ B},
\end{cases}
\end{equation}
(see subsection \ref{gcCMuhhjhggh} for the details) that are valid,
however, only for quasistationary electromagnetic fields. Next,
since $\vec{\tilde{ E}}, \vec{\tilde{ D}}, \vec{\tilde{ B}},
\vec{\tilde{ H}}$, $\vec B,\left(\vec E+\frac{1}{c}\,\vec {u}\times
\vec{ B}\right)$ and $\left(\vec {\tilde j^*}-\tilde\rho^*\vec
u\right),\left(\vec {j}-\rho\vec u\right),\left(\vec {u}-\vec
v\right) $ are all proper vector fields and $\tilde\rho^*,\rho$ are
proper scalars, as before we deduce that all equations in
\er{MaxVacFullPPNffGGvbccgcu8uuuuukjjkhhhihjhjkhhkmgjhjhhkhjhhjjhjhjjhint}
and
\er{MaxVacFullPPNffGGvbccgcu8uuuuukjjkhhhihjhjkhhkmgjhjhhkhjhhjjhjhjjh11int}
are invariant under the change of inertial or non-inertial cartesian
coordinate system. Therefore, if in coordinate system $(*)$ we can
use the approximate equations, given by
\er{MaxVacFullPPNffGGvbccgcu8uuuuukjjkhhhihjhjkhhkmgjhjhhkhjhhjjhjhjjhint}
and
\er{MaxVacFullPPNffGGvbccgcu8uuuuukjjkhhhihjhjkhhkmgjhjhhkhjhhjjhjhjjh11int},
then we can use the similar approximation
also in coordinate system $(**)$, even in the case when in system
$(**)$ \er{slowaetherGGaagghghint}, \er{slochangGGaajhhjhjint} or
\er{MaxVacFullPPNmmmffffffhhtygghGGGGyuhggghghffggfgfint} are not
satisfied.
Next, by \er{vjhfhjtjhjhuyyiyGGint} we adjoint
\er{MaxVacFullPPNffGGvbccgcu8uuuuukjjkhhhihjhjkhhkmgjhjhhkhjhhjjhjhjjhint}
and
\er{MaxVacFullPPNffGGvbccgcu8uuuuukjjkhhhihjhjkhhkmgjhjhhkhjhhjjhjhjjh11int}
with the Ohm's Law that in our notations has the following simple
form:
\begin{equation}\label{vjhfhjtjhjhuyyiyGGhhgjgint}
\vec {\tilde j^*}-\tilde\rho^*\,\vec u=\varepsilon\vec{\tilde E}\,,
\end{equation}
and the Joules heat term, appearing in the First Law of
Thermodynamics
\er{MaxVacFull1ninshtrgravortghhghgjkgghklhjgkghghjjkjhjkkggjkhjkhjjhhfhjhkjkhbbgjhzzzyyykkknnnNWBWHWPPNjjhjhjhjhint},
in our notations has the following simple form
\begin{equation}\label{vjhiiiihhjjhjhint}\left(\vec j-\rho\vec u\right)\cdot\left(\vec E+\frac{1}{c}\vec
u\times\vec B\right)= \left(\vec {\tilde j^*}-\tilde\rho^*\,\vec
u\right)\cdot\vec{\tilde E}\,.
\end{equation}

Finally, note that the system
\er{MaxVacFullPPNffGGvbccgcu8uuuuukjjkhhhihjhjkhhkmgjhjhhkhjhhjjhjhjjhint}
is completely analogous to the system
\er{MaxVacFullPPNffGGvbccgcint}, where the field
$\frac{1}{\gamma_0}\vec{\tilde{ E}}$ substitutes the field $\vec D$,
the field $\vec{\tilde{ B}}$ substitutes the field $\vec B$,
$\tilde\rho^*,\vec{\tilde j^*}$ substitute $\rho,\vec{j}$, and with
the real velocity of the medium $\vec u$ instead of the field $\vec
{\tilde u}=\left(\gamma_0\vec v+(1-\gamma_0)\vec u\right)$. On the
other hand, in the case of the real medium (not vacuum) the
equations in
\er{MaxVacFullPPNffGGvbccgcu8uuuuukjjkhhhihjhjkhhkmgjhjhhkhjhhjjhjhjjhint}
are completely independent of the vectorial gravitation potential
$\vec v$. Since the systems
\er{MaxVacFullPPNffGGvbccgcu8uuuuukjjkhhhihjhjkhhkmgjhjhhkhjhhjjhjhjjhint}
and \er{MaxVacFullPPNffGGvbccgcint} are similar they both obeys
wave-type solutions and thus one can wonder here: is it possible
that
\er{MaxVacFullPPNffGGvbccgcu8uuuuukjjkhhhihjhjkhhkmgjhjhhkhjhhjjhjhjjhint}
is approximately equivalent to \er{MaxVacFullPPNffGGvbccgcint} also
for highly oscillating in time electromagnetic fields and/or for
wave-type solutions in the absence of charges and currents, for
example in the case of the visible light? Unfortunately, the answer
for the last question should be negative, indeed as it can be seen,
the term $(1-\gamma_0)\vec u$ (consistently with the equality $\vec
{\tilde u}=\left(\gamma_0\vec v+(1-\gamma_0)\vec u\right)$) appear
in the the Fizeau experiment for the light (see subsection
\ref{seGOfzint}) rather than the term $\vec u$. The main reason for
it is that \er{slochangGGaajhhjhjint} is not valid for highly
oscillating in time electromagnetic fields. Moreover, in the case of
the complete absence of charges and currents in the whole space
\er{slochangGGaajhhjhjint} is not valid also for any nontrivial
wave-type solution, even with moderate frequency. So
\er{MaxVacFullPPNffGGvbccgcu8uuuuukjjkhhhihjhjkhhkmgjhjhhkhjhhjjhjhjjhint}
and
\er{MaxVacFullPPNffGGvbccgcu8uuuuukjjkhhhihjhjkhhkmgjhjhhkhjhhjjhjhjjh11int}
are valid, only for quasistationary electromagnetic fields,
generated by charges and currents moderately changing in time, for
example in the case of estimation of DC or AC chains.

Next, as before, again by the third and the fourth equations in
\er{MaxVacFullPPNffGGvbccgcu8uuuuukjjkhhhihjhjkhhkmgjhjhhkhjhhjjhjhjjhint}
we can write
\begin{equation}\label{MaxVacFull1bjkgjhjhgjgjgkjfhjfdghghligioiuittrPPNggjoiuouiint}
\begin{cases}
\vec {\tilde B}\approx curl_{\vec x} \vec {\tilde A^*},\\
\vec {\tilde E}\approx -\nabla_{\vec
x}\tilde\Psi^*-\frac{1}{c}\frac{\partial\vec {\tilde A^*}}{\partial
t}+\frac{1}{c}\,\vec {u}\times \left(curl_{\vec x} \vec {\tilde
A^*}\right),
\end{cases}
\end{equation}
where $\tilde\Psi^*$ and $\vec {\tilde A^*}$ are the scalar and the
vectorial electromagnetic potentials in Hertz's notation. Next,
analogously to \er{vhfffngghhjghhgPPNghghghutghffugghjhjkjjklggint}
we define the proper scalar electromagnetic potential in Hertz's
notation:
\begin{equation}\label{vhfffngghhjghhgPPNghghghutghffugghjhjkjjklgghihhihint}
\tilde\Psi^*_1:=\tilde\Psi^*-\frac{1}{c}\vec {\tilde A^*}\cdot\vec
{u}.
\end{equation}
As before, the electromagnetic potentials in Hertz's notation are
not uniquely defined and thus we need to choose a calibration. For
definiteness we can take $\vec {\tilde A^*}$ to satisfy
\begin{equation}\label{MaxVacFull1bjkgjhjhgjgjgkjfhjfdghghligioiuittrPPN22jnhhjhjhint}
div_{\vec x}\vec {\tilde A^*}= 0.
\end{equation}
Then, as before, $\vec {\tilde A^*}$ is a proper vector field and
$\tilde\Psi^*_1$ is a proper scalar field. It is clear that any
other choice of electromagnetic potentials with a different
calibration, as before, can be obtained by
\begin{equation}\label{MaxVacFull1bjkgjhjhgjgjgkjfhjfdghghligioiuittrPPNhjkjhkjgghhjjhjjhjgjint}
\begin{cases}
\tilde\Psi^*\,\rightarrow\, \tilde\Psi^*+\frac{1}{c}\frac{\partial w}{\partial t}\\
\vec {\tilde A^*}\,\rightarrow\, \vec {\tilde A^*}-\nabla_{\vec x}w\\
\tilde\Psi^*_1\,\rightarrow\,
\tilde\Psi^*_1+\frac{1}{c}\left(\frac{\partial w}{\partial t}+\vec
u\cdot\nabla_{\vec x}w\right).
\end{cases}
\end{equation}
where $w:=w(\vec x,t)$ is an arbitrary proper scalar field.
Moreover, as before, under such a change of calibration, $\vec
{\tilde A^*}$ is still a proper vector field and $\tilde\Psi^*_1$ is
still a proper scalar field, provided the scalar field $w$ is
\underline{proper}. In particular the following alternative to
\er{MaxVacFull1bjkgjhjhgjgjgkjfhjfdghghligioiuittrPPN22jnhhjhjhint}
calibration of the potentials can be chosen:
\begin{equation}\label{MaxVacFullPPNjjjjffhhGGljkjjkkjhkint}
\frac{1}{\kappa_0\gamma_0
c}\left(\frac{\partial\tilde\Psi^*_1}{\partial t}+\vec
{u}\cdot\nabla_{\vec x}\tilde\Psi^*_1\right)+div_{\vec x}\vec
{\tilde A^*}=0.
\end{equation}

Next, inserting
\er{vhfffngghhjghhgPPNghghghutghffugghjhjkjjklgghihhihint} into
\er{MaxVacFull1bjkgjhjhgjgjgkjfhjfdghghligioiuittrPPNggjoiuouiint}
gives
\begin{equation}\label{MaxVacFull1bjkgjhjhgjgjgkjfhjfdghghligioiuittrPPNggjoiuouighujjhmngkhint}
\begin{cases}
\vec {\tilde B}\approx curl_{\vec x} \vec {\tilde A^*},\\
\vec {\tilde E}\approx -\nabla_{\vec
x}\tilde\Psi^*_1-\frac{1}{c}\left(\frac{\partial\vec {\tilde
A^*}}{\partial t}-\vec {u}\times \left(curl_{\vec x} \vec {\tilde
A^*}\right)+\nabla_{\vec x}\left(\vec {\tilde A^*}\cdot\vec
{u}\right)\right).
\end{cases}
\end{equation}
%
%
%
%
%
%
%
%
Furthermore, inserting
\er{MaxVacFull1bjkgjhjhgjgjgkjfhjfdghghligioiuittrPPNggjoiuouighujjhmngkhint}
into
\er{MaxVacFullPPNffGGvbccgcu8uuuuukjjkhhhihjhjkhhkmgjhjhhkhjhhjjhjhjjhint}
and using \er{slowaetherGGaagghghint}, \er{slochangGGaajhhjhjint}
and \er{MaxVacFullPPNmmmffffffhhtygghGGGGyuhggghghffggfgfint} we
deduce
\begin{equation}\label{MaxVacFullPPNffGGvbccgcu8uuuuukjjkhhhihjhjkhhkmgjhjhhkhjhhjjhjhjjhjkkhhhhjint}
\begin{cases}
curl_{\vec x} \left(\kappa_0\vec{\tilde{
B}}\right)\approx\frac{4\pi}{c}\left(\vec {\tilde
j^*}-\tilde\rho^*\vec u\right)- \frac{1}{c}\left(\frac{\partial
}{\partial t}\left(\frac{1}{\gamma_0}\nabla_{\vec
x}\tilde\Psi^*_1\right)-\,curl_{\vec x}\left(\vec
{u}\times\frac{1}{\gamma_0} \nabla_{\vec
x}\tilde\Psi^*_1\right)+\left(div_{\vec x}\left\{\frac{1}{\gamma_0}
\nabla_{\vec
x}\tilde\Psi^*_1\right\}\right)\vec u\right),\\
div_{\vec x} \left(\frac{1}{\gamma_0}\vec{\tilde{ E}}\right)\approx
4\pi\tilde\rho^*,\\
\frac{\partial\tilde\rho^*}{\partial t}+div_{\vec x}\vec {\tilde
j^*}\approx 0,\\
\vec {\tilde E}\approx-\nabla_{\vec
x}\tilde\Psi^*_1-\frac{1}{c}\left(\frac{\partial\vec {\tilde
A^*}}{\partial t}-\,curl_{\vec x}\left(\vec {u}\times \vec {\tilde
A^*}\right)+\left(div_{\vec x}\vec {\tilde A^*}\right)\vec
{u}\right),
\\ \vec {\tilde B}\approx curl_{\vec x} \vec {\tilde
A^*},
\end{cases}
\end{equation}
(see subsection \ref{gcCMuhhjhggh} for the details). Then again all
equations in the new system
\er{MaxVacFullPPNffGGvbccgcu8uuuuukjjkhhhihjhjkhhkmgjhjhhkhjhhjjhjhjjhjkkhhhhjint}
are invariant under the change of inertial or non-inertial cartesian
coordinate system.

Furthermore, if we assume that in the domain of the study the
coefficients $\gamma_0\neq 0$ and $\kappa_0\neq 0$ vary sufficiently
slow on the place and time and thus their spatial and temporal
derivatives are negligible then
by \er{slowaetherGGaagghghint}, \er{slochangGGaajhhjhjint} and
\er{MaxVacFullPPNmmmffffffhhtygghGGGGyuhggghghffggfgfint},
\er{MaxVacFullPPNffGGvbccgcu8uuuuukjjkhhhihjhjkhhkmgjhjhhkhjhhjjhjhjjhjkkhhhhjint}
can be approximately rewritten as:
\begin{equation}\label{MaxVacFullPPNffGGvbccgcu8uuuuukjjkhhhihjhjkhhkmgjhjhhkhjhhjjhjhjjhjjjkghhhfint}
\begin{cases}
-\Delta_{\vec x}\vec {\tilde
A^*}\approx\frac{4\pi}{\kappa_0c}\left(\vec {\tilde
j^*}-\tilde\rho^*\vec u\right)- \nabla_{\vec x}\left\{div_{\vec
x}\vec {\tilde
A^*}+\frac{1}{c}\frac{1}{\gamma_0\kappa_0}\left(\frac{\partial\tilde\Psi^*_1}{\partial
t}+\vec
{u}\cdot\nabla_{\vec x}\tilde\Psi^*_1\right)\right\},\\
-\Delta_{\vec x}\tilde\Psi_1\approx
4\pi\gamma_0\tilde\rho^*+\frac{1}{c}\left(\frac{\partial}{\partial
t}\left(div_{\vec x}\vec {\tilde A^*}\right)+div_{\vec
x}\left\{\left(div_{\vec x}\vec {\tilde A^*}\right)\vec
{u}\right\}\right),\\
\frac{\partial\tilde\rho^*}{\partial t}+div_{\vec x}\vec {\tilde
j^*}\approx 0,\\
\vec {\tilde E}\approx -\nabla_{\vec
x}\tilde\Psi^*_1-\frac{1}{c}\left(\frac{\partial\vec {\tilde
A^*}}{\partial t}-\vec {u}\times \left(curl_{\vec x} \vec {\tilde
A^*}\right)+\nabla_{\vec x}\left(\vec {\tilde A^*}\cdot\vec
{u}\right)\right),\\
\vec {\tilde B}\approx curl_{\vec x} \vec {\tilde A^*},
\end{cases}
\end{equation}
(again see subsection \ref{gcCMuhhjhggh} for the details).
Furthermore, defining the conduction currents:
\begin{equation}\label{uiyuyyyyyyyyyyyyyyyyyyyyyyyyyyyyyyyyyyyyyyyyyyyyyyykljint}
\begin{cases}
\vec {j}_{cond}:=\vec {j}-\rho\vec u,\\
\vec {\tilde j^*}_{cond}:=\vec {\tilde j^*}-\tilde\rho^*\vec u,
\end{cases}
\end{equation}
we obviously have:
\begin{equation}\label{uiyuyyyyyyyyyyyyyyyyyyyyyyyyyyyyyyyyyyyyyyyyyyyyyyykljggngnint}
\begin{cases}
\vec {j}_{cond}= \vec {\tilde j^*}_{cond},\\
\frac{\partial\rho}{\partial t}=div_{\vec x}\left\{\rho\vec
u\right\}=-div_{\vec x}\left\{\vec {j}_{cond}\right\},
\\
\frac{\partial\tilde\rho^*}{\partial t}+div_{\vec
x}\left\{\tilde\rho^*\vec u\right\}\approx-div_{\vec x}\left\{\vec
{\tilde j^*}_{cond}\right\},
\end{cases}
\end{equation}
and moreover, obviously $\vec {j}_{cond}=\vec {\tilde j^*}_{cond}$
is a \underline{proper} vector field. Thus, in the case of
calibration
\er{MaxVacFull1bjkgjhjhgjgjgkjfhjfdghghligioiuittrPPN22jnhhjhjhint}
by \er{vfyutuyfffhhgfhgfhint} and \er{vhfffngghhjghhgjlkhjhkPPPint}
in Proposition \ref{yghgjtgyrtrtint}, using
\er{MaxVacFullPPNmmmffffffhhtygghGGGGyuhggghghffggfgfint},
\er{MaxVacFullPPNffGGvbccgcu8uuuuukjjkhhhihjhjkhhkmgjhjhhkhjhhjjhjhjjhjjjkghhhfint}
can be rewritten as:
\begin{equation}\label{MaxVacFullPPNffGGvbccgcu8uuuuukjjkhhhihjhjkhhkmgjhjhhkhjhhjjhjhjjhjjjkghhhfhjijhjhkhhkint}
\begin{cases}
-\Delta_{\vec x}\tilde\Psi_1\approx
4\pi\gamma_0\tilde\rho^*,\\
-\Delta_{\vec x}\vec {\tilde A^*}\approx\frac{4\pi}{\kappa_0c}\vec
{\tilde j^*}_{cond}
-\frac{1}{c}\frac{1}{\gamma_0\kappa_0}\left(\frac{\partial
}{\partial t}\left(\nabla_{\vec x}\tilde\Psi^*_1\right)-\,curl_{\vec
x}\left(\vec {u}\times\nabla_{\vec
x}\tilde\Psi^*_1\right)+\left(\Delta_{\vec
x}\tilde\Psi^*_1\right)\vec u\right)
,\\
\frac{\partial\tilde\rho^*}{\partial t}+div_{\vec
x}\left\{\tilde\rho^*\vec u\right\}\approx-div_{\vec x}\left\{\vec
{\tilde j^*}_{cond}\right\},\\
\vec {\tilde E}\approx -\nabla_{\vec
x}\tilde\Psi^*_1-\frac{1}{c}\left(\frac{\partial\vec {\tilde
A^*}}{\partial t}-\vec {u}\times \left(curl_{\vec x} \vec {\tilde
A^*}\right)+\nabla_{\vec x}\left(\vec {\tilde A^*}\cdot\vec
{u}\right)\right),
\\
\vec {\tilde B}\approx curl_{\vec x} \vec {\tilde A^*}.
\end{cases}
\end{equation}
On the other hand, under the alternative calibration
\er{MaxVacFullPPNjjjjffhhGGljkjjkkjhkint}, again using
\er{slowaetherGGaagghghint}, \er{slochangGGaajhhjhjint} and
\er{MaxVacFullPPNmmmffffffhhtygghGGGGyuhggghghffggfgfint} in the
second equation of
\er{MaxVacFullPPNffGGvbccgcu8uuuuukjjkhhhihjhjkhhkmgjhjhhkhjhhjjhjhjjhjjjkghhhfint},
we approximately rewrite
\er{MaxVacFullPPNffGGvbccgcu8uuuuukjjkhhhihjhjkhhkmgjhjhhkhjhhjjhjhjjhjjjkghhhfint}
as:
\begin{equation}\label{MaxVacFullPPNffGGvbccgcu8uuuuukjjkhhhihjhjkhhkmgjhjhhkhjhhjjhjhjjhjjjkghhhfjkjkkhjggjint}
\begin{cases}
-\Delta_{\vec x}\vec {\tilde A^*}\approx\frac{4\pi}{\kappa_0c}\vec
{\tilde j^*}_{cond},\\
-\Delta_{\vec x}\tilde\Psi_1\approx
4\pi\gamma_0\tilde\rho^*,\\
\frac{\partial\tilde\rho^*}{\partial t}+div_{\vec
x}\left\{\tilde\rho^*\vec u\right\}\approx-div_{\vec x}\left\{\vec
{\tilde j^*}_{cond}\right\},\\
\vec {\tilde E}\approx -\nabla_{\vec
x}\tilde\Psi^*_1-\frac{1}{c}\left(\frac{\partial\vec {\tilde
A^*}}{\partial t}-\vec {u}\times \left(curl_{\vec x} \vec {\tilde
A^*}\right)+\nabla_{\vec x}\left(\vec {\tilde A^*}\cdot\vec
{u}\right)\right),
\\
\vec {\tilde B}\approx curl_{\vec x} \vec {\tilde A^*}.
\end{cases}
\end{equation}
In both given cases of calibration in
\er{MaxVacFullPPNffGGvbccgcu8uuuuukjjkhhhihjhjkhhkmgjhjhhkhjhhjjhjhjjhjjjkghhhfhjijhjhkhhkint}
or
\er{MaxVacFullPPNffGGvbccgcu8uuuuukjjkhhhihjhjkhhkmgjhjhhkhjhhjjhjhjjhjjjkghhhfjkjkkhjggjint}
we just need to solve two Laplace equations in order to find
$\tilde\Psi_1$ and $\vec {\tilde A^*}$ and then we find $\vec
{\tilde E}$ and $\vec {\tilde B}$ by differentiation. Moreover, in
 either the case of
\er{MaxVacFullPPNffGGvbccgcu8uuuuukjjkhhhihjhjkhhkmgjhjhhkhjhhjjhjhjjhjjjkghhhfhjijhjhkhhkint}
or the case of
\er{MaxVacFullPPNffGGvbccgcu8uuuuukjjkhhhihjhjkhhkmgjhjhhkhjhhjjhjhjjhjjjkghhhfjkjkkhjggjint}
we have
\begin{equation}\label{MaxVagjkgggggggggggggjhjgint}
div_{\vec x}\left\{\vec {\tilde E}+\nabla_{\vec
x}\tilde\Psi^*_1\right\}\approx 0.
\end{equation}
and the Ohm's Law in the form of \er{vjhfhjtjhjhuyyiyGGhhgjgint}:
\begin{equation}\label{vjhfhjtjhjhuyyiyGGhhgjgkhhhhhjint}
\vec {\tilde j^*}_{cond}=\varepsilon\vec{\tilde E}.
\end{equation}
Furthermore, the Joules heat term, has the following simple form
\begin{equation}\label{vjhiiiihhjjhjhkkkint}
\vec {\tilde j^*}_{cond}\cdot\vec{\tilde E}\,.
\end{equation}
Moreover, in both cases of
\er{MaxVacFullPPNffGGvbccgcu8uuuuukjjkhhhihjhjkhhkmgjhjhhkhjhhjjhjhjjhjjjkghhhfhjijhjhkhhkint}
or
\er{MaxVacFullPPNffGGvbccgcu8uuuuukjjkhhhihjhjkhhkmgjhjhhkhjhhjjhjhjjhjjjkghhhfjkjkkhjggjint},
by the last two equations there we obviously have:
\begin{equation}\label{gjkgjkgjkgjkgjkgjkgjkgjkgjkgjkgjkgjkgjkgjkgjkgjkgjkgjkgjkgjkgjkhjjhint}
\begin{cases}
curl_{\vec x} \vec{\tilde{
E}}\approx-\left(\frac{1}{c}\frac{\partial \vec{\tilde{
B}}}{\partial t}-\frac{1}{c}\,curl_{\vec x}\left(\vec {u}\times
\vec{\tilde{
B}}\right)\right),\\
div_{\vec x} \vec{\tilde{ B}}\approx 0.
\end{cases}
\end{equation}

Next, in order to find the real electromagnetic fields in the usual
notation we have
\er{MaxVacFullPPNffGGvbccgcu8uuuuukjjkhhhihjhjkhhkmgjhjhhkhjhhjjhjhjjh11int},
i.e. the following:
\begin{equation}\label{MaxVacFullPPNffGGvbccgcu8uuuuukjjkhhhihjhjkhhkmgjhjhhkhjhhjjhjhjjh11ljkjkljkljint}
\begin{cases}
\vec B\approx\vec{\tilde{ B}}+\frac{1}{c}\,\left(\vec {u}-\vec
v\right)\times \vec{\tilde{ E}},\\
\vec E=\vec{\tilde{ E}}-\frac{1}{c}\,\vec {u}\times \vec{ B},\\
\rho=\tilde\rho^*+\frac{1}{c^2}\left(\vec {u}-\vec v\right)\cdot\vec
{\tilde j^*}_{cond},\\
\vec {j}_{cond}= \vec {\tilde j^*}_{cond},\\
\vec j=\vec {\tilde j^*}_{cond}+\rho\vec u.
\end{cases}
\end{equation}
Finally, as before, we deduce that all equations in either
\er{MaxVacFullPPNffGGvbccgcu8uuuuukjjkhhhihjhjkhhkmgjhjhhkhjhhjjhjhjjhjjjkghhhfhjijhjhkhhkint}
or
\er{MaxVacFullPPNffGGvbccgcu8uuuuukjjkhhhihjhjkhhkmgjhjhhkhjhhjjhjhjjhjjjkghhhfjkjkkhjggjint},
and, in addition, \er{MaxVagjkgggggggggggggjhjgint},
\er{vjhfhjtjhjhuyyiyGGhhgjgkhhhhhjint},
\er{gjkgjkgjkgjkgjkgjkgjkgjkgjkgjkgjkgjkgjkgjkgjkgjkgjkgjkgjkgjkgjkhjjhint}
and
\er{MaxVacFullPPNffGGvbccgcu8uuuuukjjkhhhihjhjkhhkmgjhjhhkhjhhjjhjhjjh11ljkjkljkljint}
are invariant under the change of inertial or non-inertial cartesian
coordinate system. Therefore, if in coordinate system $(*)$ we can
use the approximate equations, given by either
\er{MaxVacFullPPNffGGvbccgcu8uuuuukjjkhhhihjhjkhhkmgjhjhhkhjhhjjhjhjjhjjjkghhhfhjijhjhkhhkint}
or
\er{MaxVacFullPPNffGGvbccgcu8uuuuukjjkhhhihjhjkhhkmgjhjhhkhjhhjjhjhjjhjjjkghhhfjkjkkhjggjint},
and \er{MaxVagjkgggggggggggggjhjgint},
\er{vjhfhjtjhjhuyyiyGGhhgjgkhhhhhjint},
\er{gjkgjkgjkgjkgjkgjkgjkgjkgjkgjkgjkgjkgjkgjkgjkgjkgjkgjkgjkgjkgjkhjjhint}
and
\er{MaxVacFullPPNffGGvbccgcu8uuuuukjjkhhhihjhjkhhkmgjhjhhkhjhhjjhjhjjh11ljkjkljkljint},
then we can use the similar approximation
also in coordinate system $(**)$, even in the case when in system
$(**)$ \er{slowaetherGGaagghghint}, \er{slochangGGaajhhjhjint} or
\er{MaxVacFullPPNmmmffffffhhtygghGGGGyuhggghghffggfgfint} are not
satisfied.

Next, note that either
\er{MaxVacFullPPNffGGvbccgcu8uuuuukjjkhhhihjhjkhhkmgjhjhhkhjhhjjhjhjjhjjjkghhhfhjijhjhkhhkint}
or
\er{MaxVacFullPPNffGGvbccgcu8uuuuukjjkhhhihjhjkhhkmgjhjhhkhjhhjjhjhjjhjjjkghhhfjkjkkhjggjint},
and \er{MaxVagjkgggggggggggggjhjgint},
\er{vjhfhjtjhjhuyyiyGGhhgjgkhhhhhjint} and
\er{gjkgjkgjkgjkgjkgjkgjkgjkgjkgjkgjkgjkgjkgjkgjkgjkgjkgjkgjkgjkgjkhjjhint}
are completely independent of the vectorial gravitational potential
$\vec v$. Moreover, note that the first two equations in
\er{MaxVacFullPPNffGGvbccgcu8uuuuukjjkhhhihjhjkhhkmgjhjhhkhjhhjjhjhjjhjjjkghhhfjkjkkhjggjint}
are completely independent also on the velocity field of the medium
$\vec u$. Furthermore, if $\Omega:=\Omega(t)\subset\mathbb{R}^3$ is
a three-dimensional domain, moving with velocity field $\vec u$
together with the given medium, then, by the third equation of
either
\er{MaxVacFullPPNffGGvbccgcu8uuuuukjjkhhhihjhjkhhkmgjhjhhkhjhhjjhjhjjhjjjkghhhfhjijhjhkhhkint}
or
\er{MaxVacFullPPNffGGvbccgcu8uuuuukjjkhhhihjhjkhhkmgjhjhhkhjhhjjhjhjjhjjjkghhhfjkjkkhjggjint},
using part {{\bf (iii)}} of Proposition
\ref{jgjjjjjjjjjjjjjjjjjjjjjjjjjjjjjjjjjkhhkhint} and the Divergence
Integral Theorem of the Calculus we deduce:
%
%
%
%
%
%
%
%
\begin{equation}\label{hjgghffhhffgfgfghkhjhjggjkhkhbnv44jhggggggggggghkjhint}
\frac{d}{dt}\left(\iiint\rho\, d\Omega(t)\right)=-\iint\vec
{j}_{cond}\,\cdot\vec n\, d\left(\partial\Omega(t)\right)=-\iint\vec
{\tilde j^*}_{cond}\,\cdot\vec n\,
d\left(\partial\Omega(t)\right)\approx
\frac{d}{dt}\left(\iiint\tilde\rho^*\, d\Omega(t)\right).
\end{equation}
Thus, denoting
\begin{equation}\label{khjjhjhgjgjggjint}
\begin{cases}
Q_{\Omega(t)}(t):=\iiint\rho\, d\Omega(t),\quad \tilde Q^*_{\Omega(t)}(t):=\iiint\tilde\rho^*\, d\Omega(t)\quad\text{and}\quad\\
I_{\partial\Omega(t)}(t):=\iint\vec {j}_{cond}\,\cdot\vec n
=\iint\vec {\tilde j^*}_{cond}\,\cdot\vec n\,
d\left(\partial\Omega(t)\right):=\tilde I^*_{\partial\Omega(t)}(t),
\end{cases}
\end{equation}
which are the total charge inside the domain $\Omega(t)$, the total
charge$^*$ inside the domain $\Omega(t)$ and the corresponding total
current outward the boundary of $\Omega(t)$, by
\er{hjgghffhhffgfgfghkhjhjggjkhkhbnv44jhggggggggggghkjhint} we have
\begin{equation}\label{hjgghffhhffgfgfghkhjhjggjkhkhbnv44jhggggggggggghkjhkjhjhjhjhjhint}
\frac{d}{dt}\left(Q_{\Omega(t)}(t)\right)=-I_{\partial\Omega(t)}(t)=-\tilde
I^*_{\partial\Omega(t)}(t)\approx \frac{d}{dt}\left(\tilde
Q^*_{\Omega(t)}(t)\right).
\end{equation}
On the other hand, if $\gamma:=\gamma(t)\subset\mathbb{R}^3$ is a
one-dimensional curve oriented by the unit tangent vector $\vec
t:=\vec t(\vec x,t)$, having the starting and the ending points
$\vec r_{begin}(t), \vec r_{end}(t)$ and moving with velocity field
$\vec u$ together with the given medium, then, by the fourth
equation in either
\er{MaxVacFullPPNffGGvbccgcu8uuuuukjjkhhhihjhjkhhkmgjhjhhkhjhhjjhjhjjhjjjkghhhfhjijhjhkhhkint}
or
\er{MaxVacFullPPNffGGvbccgcu8uuuuukjjkhhhihjhjkhhkmgjhjhhkhjhhjjhjhjjhjjjkghhhfjkjkkhjggjint}
using part {{\bf (ii)}} of Proposition
\ref{jgjjjjjjjjjjjjjjjjjjjjjjjjjjjjjjjjjkhhkhint} we obtain:
%
%
%
%
%
%
%
%
\begin{equation}\label{MaxVacFullPPNffGGvbccgcu8uuuuukjjkhhhihjhjkhhkmgjhjhhkhjhhjjhjhjjh11ljkjkljkljjjkjkljkjk11jhgggjh11gjgghhjhhhint}
\int\vec {\tilde E}\cdot\vec t\,
d\gamma(t)\approx\tilde\Psi^*_1\left(\vec
r_{begin}(t),t\right)-\tilde\Psi^*_1\left(\vec
r_{end}(t),t\right)-\frac{1}{c}\frac{d}{dt}\left(\int\vec {\tilde
A^*}\cdot\vec t\, d\gamma(t)\right).
\end{equation}
In particular, in the case where $\gamma(t):=\partial\mathcal{S}(t)$
is a boundary of a two-dimensional surface
$\mathcal{S}:=\mathcal{S}(t)\subset\mathbb{R}^3$, oriented by the
unit normal $\vec n:=\vec n(\vec x,t)$, since in the later case we
have $\vec r_{begin}(t)=\vec r_{end}(t)$, using the last equation in
either
\er{MaxVacFullPPNffGGvbccgcu8uuuuukjjkhhhihjhjkhhkmgjhjhhkhjhhjjhjhjjhjjjkghhhfhjijhjhkhhkint}
or
\er{MaxVacFullPPNffGGvbccgcu8uuuuukjjkhhhihjhjkhhkmgjhjhhkhjhhjjhjhjjhjjjkghhhfjkjkkhjggjint}
and the Stokes Theorem, we rewrite
\er{MaxVacFullPPNffGGvbccgcu8uuuuukjjkhhhihjhjkhhkmgjhjhhkhjhhjjhjhjjh11ljkjkljkljjjkjkljkjk11jhgggjh11gjgghhjhhhint}
as
\begin{equation}\label{MaxVacFullPPNffGGvbccgcu8uuuuukjjkhhhihjhjkhhkmgjhjhhkhjhhjjhjhjjh11ljkjkljkljjjkjkljkjk11jhgggjh11gjgghhjhhhhjjhjghjjhhjhjhint}
\int\vec {\tilde E}\cdot\vec t\,
d\gamma(t)\approx-\frac{1}{c}\frac{d}{dt}\left(\iint\vec {\tilde
B}\cdot\vec n\,
d\mathcal{S}(t)\right)=-\frac{1}{c}\frac{d}{dt}\left(\iint
curl_{\vec x}\vec {\tilde A^*}\cdot\vec n\, d\mathcal{S}(t)\right).
\end{equation}
Finally, the obtained relations in
\er{MaxVagjkgggggggggggggjhjgint},
\er{vjhfhjtjhjhuyyiyGGhhgjgkhhhhhjint}, \er{vjhiiiihhjjhjhkkkint},
\er{hjgghffhhffgfgfghkhjhjggjkhkhbnv44jhggggggggggghkjhkjhjhjhjhjhint}
with \er{khjjhjhgjgjggjint},
\er{MaxVacFullPPNffGGvbccgcu8uuuuukjjkhhhihjhjkhhkmgjhjhhkhjhhjjhjhjjh11ljkjkljkljjjkjkljkjk11jhgggjh11gjgghhjhhhint},
\er{MaxVacFullPPNffGGvbccgcu8uuuuukjjkhhhihjhjkhhkmgjhjhhkhjhhjjhjhjjh11ljkjkljkljjjkjkljkjk11jhgggjh11gjgghhjhhhhjjhjghjjhhjhjhint}
and the first two equations in
\er{MaxVacFullPPNffGGvbccgcu8uuuuukjjkhhhihjhjkhhkmgjhjhhkhjhhjjhjhjjhjjjkghhhfjkjkkhjggjint}
are completely independent of the velocity field $\vec u$ ( and on
the vectorial gravitational potential $\vec v$). However, the
mentioned relations in \er{MaxVagjkgggggggggggggjhjgint},
\er{vjhfhjtjhjhuyyiyGGhhgjgkhhhhhjint},
\er{hjgghffhhffgfgfghkhjhjggjkhkhbnv44jhggggggggggghkjhkjhjhjhjhjhint}
with \er{khjjhjhgjgjggjint},
\er{MaxVacFullPPNffGGvbccgcu8uuuuukjjkhhhihjhjkhhkmgjhjhhkhjhhjjhjhjjh11ljkjkljkljjjkjkljkjk11jhgggjh11gjgghhjhhhint},
\er{MaxVacFullPPNffGGvbccgcu8uuuuukjjkhhhihjhjkhhkmgjhjhhkhjhhjjhjhjjh11ljkjkljkljjjkjkljkjk11jhgggjh11gjgghhjhhhhjjhjghjjhhjhjhint}
and the first two equations in
\er{MaxVacFullPPNffGGvbccgcu8uuuuukjjkhhhihjhjkhhkmgjhjhhkhjhhjjhjhjjhjjjkghhhfjkjkkhjggjint}
are together sufficient for estimation of DC or AC linear chains
and, in particular, they sufficient to obtain the Kirchhoff's
current and voltage laws, Faraday's law of induction and the law of
capacity. Thus, the estimation of DC or quasistationary AC currents
in the case of moving electrical chains and/or in the case of
non-trivial gravitation is completely analogous to that estimation
for the \underline{resting} chains without influence of any
gravitation!

\subsection{Geometric optics inside a moving medium and/or in the presence of gravitational
field}
\subsubsection{Preliminary calculations}
Assume that in some inertial or non-inertial cartesian coordinate
system a scalar \underline{real valued} field $U:=U(\vec x,t)$,
characterizing some wave, satisfies the following wave equation
\begin{equation}\label{MaxVacFullPPNmmmffffffiuiuhjuughbghhjjint}
\frac{\partial}{\partial
t}\left\{\frac{1}{c^2_0}\left(\frac{\partial U}{\partial t}+\vec
{\tilde u}\cdot\nabla_{\vec x}U\right)\right\}+\Div_{\vec x}
\left\{\frac{1}{c^2_0}\left(\frac{\partial U}{\partial t}+\vec
{\tilde u}\cdot\nabla_{\vec x} U\right)\vec {\tilde
u}\right\}-\Delta_{\vec x}U=0,
\end{equation}
where $\vec {\tilde u}:=\vec {\tilde u}(\vec x,t)$ is some
moderately changing (in space and in time) speed-like vector field
and $c_0:=c_0(\vec x,t)>0$ is a moderately changing (in space and in
time) scalar quantity, that we call wave propagation speed. Note
that \er{MaxVacFullPPNmmmffffffiuiuhjuughbghhjjint} coincides with
\er{MaxVacFullPPNmmmffffffiuiuhjuGGggFGelGHGHGHGGint} and thus, in
particular, $U$ can represent one of any scalar components of the
electromagnetic field in the medium without dispersion, i.e. when
\er{EBDHTrans444knbuihuigGGint} is valid. In this case $\vec{\tilde
u}$ is the linear combination of the velocity of the medium $\vec u$
and the vectorial gravitational potential $\vec v$. Note also that
\er{MaxVacFull1ninshtrgravortghhghgjkgghklhjgkghghjjkjhjkkggjkhjkhjjhhfhjhkjhjjhkhbbgjlmkmmhzzzyyykkknnnNWBWHWPPNjjlllkkkkkkljkkjjkjkhjjhhgghgggghjjhgjgghint}
or
\er{MaxVacFull1ninshtrgravortghhghgjkgghklhjgkghghjjkjhjkkggjkhjkhjjhhfhjhkjhjjhkhbbgjlmkmmhzzzyyykkknnnNWBWHWPPNjjlllkkkkkkljkkjjkjkhjjhhgghgggghjjhgjgghjkjkjkint}
also coincide with \er{MaxVacFullPPNmmmffffffiuiuhjuughbghhjjint}
and thus, in particular, $U$ could represent the oscillating part of
the pressure $p_1$ in the sound wave. However, in the later case
$\vec{\tilde u}$ is equal to the averaged (macroscopical) velocity
of the fluid/gas $\vec u_0$. Moreover,
\er{MaxVacFull1ninshtrgravortghhghgjkgghklhjgkghghjjkjhjkkggjkhjkhjjhhfhjhkjkhbbgjhzzzyyykkknnnNWBWHWPPNjjjljkjkkhhhhhhhhhhhhfnfffhuyuyuyutjhjhjhggjjfgfhfhjhhgjgjhkjhjhkhkhjjint}
and \er{MaxVacFullPPNmmmffffffiuiuhjuGGggFGjjkkjjffjjint} also
coincide with \er{MaxVacFullPPNmmmffffffiuiuhjuughbghhjjint} and
thus, in particular, $U$ could represent either longitudinal or
transverse wave in an elastic body. In the later case $\vec{\tilde
u}$ is also equal to the averaged (macroscopical) velocity of the
elastic medium $\vec u_0$.

Next if we assume that the fields $\vec {\tilde u}$ and $c_0$ are
independent of the time variable, since  $U$ is a real valued field,
then we can write the field $U$ as a Fourier's Transform on the time
variable:
\begin{equation}\label{MaxVacFullPPNmmmffffffhhtygghGGGGyuhggghghghffggggint}
U(\vec x,t)=
Re\left\{2\int_{0}^{+\infty} \hat U(\vec x,\omega)e^{i\omega
t}d\omega\right\}\quad\text{where}\quad\hat U(\vec
x,\omega):=\frac{1}{2\pi}\int_{-\infty}^{+\infty} U(\vec
x,t)e^{-i\omega t}dt\,.
\end{equation}
Moreover, by \er{MaxVacFullPPNmmmffffffiuiuhjuughbghhjjint} we
obtain that the Fourier's Transform $\hat U(\vec x,\omega)$
satisfies:
\begin{equation}\label{MaxVacFullPPNmmmffffffiuiuhjuughbghhjjuggint}
\frac{i\omega}{c^2_0}\left(i\omega \hat U+\vec {\tilde
u}\cdot\nabla_{\vec x}\hat U\right)+\Div_{\vec x}
\left\{\frac{1}{c^2_0}\left(i\omega \hat U+\vec {\tilde
u}\cdot\nabla_{\vec x} \hat U\right)\vec {\tilde
u}\right\}-\Delta_{\vec x}\hat U=0.
\end{equation}
Thus by \er{MaxVacFullPPNmmmffffffiuiuhjuughbghhjjuggint}, for every
given $\omega$ the monochromatic wave type function
\begin{equation}\label{MaxVacFullPPNmmmffffffhhtygghGGGGyuhggghghghffgggguiuint}
U_\omega(\vec x,t):=\hat U(\vec x,\omega)e^{i\omega t}
\end{equation}
is a complex solution of
\begin{equation}\label{MaxVacFullPPNmmmffffffiuiuhjuughbghhjjghghhint}
\frac{\partial}{\partial
t}\left\{\frac{1}{c^2_0}\left(\frac{\partial U_\omega}{\partial
t}+\vec {\tilde u}\cdot\nabla_{\vec
x}U_\omega\right)\right\}+\Div_{\vec x}
\left\{\frac{1}{c^2_0}\left(\frac{\partial U_\omega}{\partial
t}+\vec {\tilde u}\cdot\nabla_{\vec x} U_\omega\right)\vec {\tilde
u}\right\}-\Delta_{\vec x}U_\omega=0.
\end{equation}
Note that equation
\er{MaxVacFullPPNmmmffffffiuiuhjuughbghhjjghghhint} coincides with
\er{MaxVacFullPPNmmmffffffiuiuhjuughbghhjjint}. Moreover, by
\er{MaxVacFullPPNmmmffffffhhtygghGGGGyuhggghghghffggggint} a general
solution of \er{MaxVacFullPPNmmmffffffiuiuhjuughbghhjjint} can be
represented as a superposition of monochromatic waves of type
$U_\omega=f(\vec x,\omega)e^{i\omega t}$ that satisfy
\er{MaxVacFullPPNmmmffffffiuiuhjuughbghhjjghghhint} for every
$\omega$. Finally if we consider a complex valued function
$\mathcal{U}(\vec x,t)$, defined by
\begin{equation}\label{MaxVacFullPPNmmmffffffhhtygghGGGGyuhggghghghffggggjkjjkjkint}
\mathcal{U}(\vec x,t):=2\int_{0}^{+\infty} \hat U(\vec
x,\omega)e^{i\omega t}d\omega,
\end{equation}
then $Re\,\mathcal{U}=U$ and $\mathcal{U}$ is a complex solution of
\er{MaxVacFullPPNmmmffffffiuiuhjuughbghhjjint} (i.e. not only the
real part of $\mathcal{U}$ but also the imaginary part solve
\er{MaxVacFullPPNmmmffffffiuiuhjuughbghhjjint}).

Next assume that a scalar \underline{complex} field $U:=U(\vec x,t)$
satisfies \er{MaxVacFullPPNmmmffffffiuiuhjuughbghhjjint}. In
particular, $U$ can be a monochromatic solution of
\er{MaxVacFullPPNmmmffffffiuiuhjuughbghhjjghghhint}.
%
%
%
%
%
%
Furthermore, we represent the complex field $U$ as:
\begin{equation}\label{MaxVacFullPPNmmmffffffiuiuhjuughbghhuiiujjjjjjjjint}
U(\vec x,t)=A(\vec x,t)e^{iT(\vec x,t)},
\end{equation}
where $A:=A(\vec x,t)$ and $T:=T(\vec x,t)$ are real scalar fields.
Then define
\begin{equation}\label{MaxVacFullPPNmmmffffffiuiuhjuughbghhuiiujjjjjjjjhhhjjjint}
\omega:=\left<\,\left|\frac{\partial T}{\partial t}\right|\,\right>,
\end{equation}
where the sign $\left<\cdot\right>$ means the spatial and temporal
averaging. Next define $k_0$ and a scalar field $S:=S(\vec x,t)$ by
\begin{equation}\label{MaxVacFullPPNmmmffffffiuiuhjuughbghhuiiujjjjjjjjhhhjjjkkint}
k_0:=\frac{\omega}{c}\quad\quad\text{and}\quad\quad S(\vec
x,t)=\frac{1}{k_0}T(\vec x,t),
\end{equation}
where $c$ is a constant in the Maxwell equations for the vacuum. So
we clearly have
\begin{equation}\label{MaxVacFullPPNmmmffffffiuiuhjuughbghhuiiujjint}
U(\vec x,t)=A(\vec x,t)e^{ik_0S(\vec x,t)}\,.
\end{equation}
We in position to insert it into the wave equation
\er{MaxVacFullPPNmmmffffffiuiuhjuughbghhjjint} of the form
\begin{equation}\label{MaxVacFullPPNmmmffffffiuiuhjuughbghhjjjoojouiint}
\left(\frac{\partial}{\partial
t}\left\{\frac{1}{c^2_0}\left(\frac{\partial U}{\partial t}+\vec
{\tilde u}\cdot\nabla_{\vec x}U\right)\right\}+\Div_{\vec x}
\left\{\frac{1}{c^2_0}\left(\frac{\partial U}{\partial t}+\vec
{\tilde u}\cdot\nabla_{\vec x} U\right)\vec {\tilde
u}\right\}\right)-\Delta_{\vec x}U=0.
\end{equation}
Then, inserting \er{MaxVacFullPPNmmmffffffiuiuhjuughbghhuiiujjint}
into \er{MaxVacFullPPNmmmffffffiuiuhjuughbghhjjjoojouiint} we
deduce:
\begin{multline}\label{MaxVacFullPPNmmmffffffiuiuhjuughbghhiuijghghghhjhjhjjuiuiuiint}
k^2_0\left(\left|\nabla_{\vec
x}S\right|^2-\frac{1}{c^2_0}\left(\frac{\partial S}{\partial t}+\vec
{\tilde u}\cdot\nabla_\vec x
S\right)^2\right)A\\+\left(\frac{\partial}{\partial
t}\left\{\frac{1}{c^2_0}\left(\frac{\partial A}{\partial t}+\vec
{\tilde u}\cdot\nabla_{\vec x}A\right)\right\}+\Div_{\vec x}
\left\{\frac{1}{c^2_0}\left(\frac{\partial A}{\partial t}+\vec
{\tilde u}\cdot\nabla_{\vec x} A\right)\vec {\tilde
u}\right\}\right)-\Delta_{\vec
x}A\\
+i\kappa_0 A\left(\frac{\partial}{\partial
t}\left\{\frac{1}{c^2_0}\left(\frac{\partial S}{\partial t}+\vec
{\tilde u}\cdot\nabla_{\vec x}S\right)\right\}+\Div_{\vec x}
\left\{\frac{1}{c^2_0}\left(\frac{\partial S}{\partial t}+\vec
{\tilde u}\cdot\nabla_{\vec x} S\right)\vec {\tilde
u}\right\}-\Delta_{\vec
x}S\right)\\+2i\kappa_0\left(\frac{1}{c^2_0}\left(\frac{\partial
S}{\partial t}+\vec {\tilde u}\cdot\nabla_{\vec
x}S\right)\left(\frac{\partial A}{\partial t}+\vec {\tilde
u}\cdot\nabla_{\vec x}A\right)-\nabla_{\vec x}A\cdot\nabla_{\vec
x}S\right)=0
\end{multline}
(see subsection \ref{GO} for details).

\subsubsection{Derivation of the Eikonal equation}
Comparing both real and imaginary part of
\er{MaxVacFullPPNmmmffffffiuiuhjuughbghhiuijghghghhjhjhjjuiuiuiint}
to zero we obtain two equations:
\begin{multline}\label{MaxVacFullPPNmmmffffffiuiuhjuughbghhiuijghghghhjhjhjjint}
k^2_0\left(\left|\nabla_{\vec
x}S\right|^2-\frac{1}{c^2_0}\left(\frac{\partial S}{\partial t}+\vec
{\tilde u}\cdot\nabla_\vec x
S\right)^2\right)A\\+\left(\frac{\partial}{\partial
t}\left\{\frac{1}{c^2_0}\left(\frac{\partial A}{\partial t}+\vec
{\tilde u}\cdot\nabla_{\vec x}A\right)\right\}+\Div_{\vec x}
\left\{\frac{1}{c^2_0}\left(\frac{\partial A}{\partial t}+\vec
{\tilde u}\cdot\nabla_{\vec x} A\right)\vec {\tilde
u}\right\}\right)-\Delta_{\vec x}A=0,
%
%
%
\end{multline}
and
\begin{multline}\label{MaxVacFullPPNmmmffffffiuiuhjuughbghhiuijghghghhhfhjjint}
A\left(\frac{\partial}{\partial
t}\left\{\frac{1}{c^2_0}\left(\frac{\partial S}{\partial t}+\vec
{\tilde u}\cdot\nabla_{\vec x}S\right)\right\}+\Div_{\vec x}
\left\{\frac{1}{c^2_0}\left(\frac{\partial S}{\partial t}+\vec
{\tilde u}\cdot\nabla_{\vec x} S\right)\vec {\tilde
u}\right\}-\Delta_{\vec
x}S\right)\\+\frac{2}{c^2_0}\left(\frac{\partial S}{\partial t}+\vec
{\tilde u}\cdot\nabla_{\vec x}S\right)\left(\frac{\partial
A}{\partial t}+\vec {\tilde u}\cdot\nabla_{\vec
x}A\right)-2\nabla_{\vec x}A\cdot\nabla_{\vec x}S=0
\end{multline}

Next assume the Geometric Optics approximation that is good for the
electromagnetic wave of high frequency for example for the visible
light. The delicate Geometric Optics approximation means the
following: assume that the changes in time of $c_0$, $\vec {\tilde
u}$, $A$ and $S$ become essential after certain interval of time
$T_e$ and the changes in space of $c_0$, $\vec {\tilde u}$, $A$ and
$S$ become essential in the spatial landscape $L_e$. Then we assume
that
\begin{equation}\label{slochangGGaaffgfgjhjgghint}
k^2_0c^2_0T^2_e\,\gg\, 1\quad\text{and}\quad k^2_0L^2_e\,\gg\, 1.
\end{equation}
On the other hand the rough Geometric Optics approximation (stronger
than \er{slochangGGaaffgfgjhjgghint}) means
\begin{equation}\label{slochangGGaaffgfgjhjggh11int}
k_0c_0T_e\,\gg\, 1\quad\text{and}\quad k_0L_e\,\gg\, 1.
\end{equation}
Thus, by \er{slochangGGaaffgfgjhjgghint} we approximate
\er{MaxVacFullPPNmmmffffffiuiuhjuughbghhiuijghghghhjhjhjjint} as the
Eikonal-type equation:
\begin{equation}\label{MaxVacFullPPNmmmffffffiuiuhjuughbghhiuijghghghhjhjhhghyuyjjjhhjhjffint}
\frac{1}{c^2_0}\left(\frac{\partial S}{\partial t}+\vec {\tilde
u}\cdot\nabla_\vec x S\right)^2=\left|\nabla_\vec x S\right|^2,
\end{equation}
and, without any use in \er{slochangGGaaffgfgjhjgghint}, we rewrite
\er{MaxVacFullPPNmmmffffffiuiuhjuughbghhiuijghghghhhfhjjint} as:
\begin{multline}\label{MaxVacFullPPNmmmffffffiuiuhjuughbghhiuijghghghhhfhhghghguygtjjjkkjjkkhhint}
A\left(\frac{\partial}{\partial
t}\left\{\frac{1}{c^2_0}\left(\frac{\partial S}{\partial t}+\vec
{\tilde u}\cdot\nabla_{\vec x}S\right)\right\}+\Div_{\vec x}
\left\{\frac{1}{c^2_0}\left(\frac{\partial S}{\partial t}+\vec
{\tilde u}\cdot\nabla_{\vec x} S\right)\vec {\tilde
u}\right\}-\Delta_{\vec
x}S\right)\\+\frac{2}{c^2_0}\left(\frac{\partial S}{\partial t}+\vec
{\tilde u}\cdot\nabla_{\vec x}S\right)\left(\frac{\partial
A}{\partial t}+\vec {\tilde u}\cdot\nabla_{\vec
x}A\right)-2\nabla_{\vec x}A\cdot\nabla_{\vec x}S=0\,.
\end{multline}
Equality
\er{MaxVacFullPPNmmmffffffiuiuhjuughbghhiuijghghghhjhjhhghyuyjjjhhjhjffint}
is called the Eikonal equation and equality
\er{MaxVacFullPPNmmmffffffiuiuhjuughbghhiuijghghghhhfhhghghguygtjjjkkjjkkhhint}
is called the equation of the beam propagation. Then, as before, we
deduce that equation
\er{MaxVacFullPPNmmmffffffiuiuhjuughbghhiuijghghghhjhjhhghyuyjjjhhjhjffint}
is invariant under the change of non-inertial cartesian coordinate
system, provided that under such change we have $S'=S$. Moreover,
\er{MaxVacFullPPNmmmffffffiuiuhjuughbghhiuijghghghhhfhhghghguygtjjjkkjjkkhhint}
is also invariant under the change of non-inertial cartesian
coordinate system, in the case that under such change we have
$A'=A$, provided that $S'=S$. So if the approximation
\er{slochangGGaaffgfgjhjgghint} is valid in some cartesian
coordinate system $(*)$, then we can use
\er{MaxVacFullPPNmmmffffffiuiuhjuughbghhiuijghghghhjhjhhghyuyjjjhhjhjffint}
and
\er{MaxVacFullPPNmmmffffffiuiuhjuughbghhiuijghghghhhfhhghghguygtjjjkkjjkkhhint}
also in any other inertial or non-inertial cartesian coordinate
system $(**)$ even in the case when \er{slochangGGaaffgfgjhjgghint}
is not valid in the system $(**)$, provided that under the change of
coordinate system we have $A'=A$ and $S'=S$. Furthermore, note that
although we established Eikonal equation
\er{MaxVacFullPPNmmmffffffiuiuhjuughbghhiuijghghghhjhjhhghyuyjjjhhjhjffint}
in the delicate Geometric Optics approximation
\er{slochangGGaaffgfgjhjgghint}, since by
\er{MaxVacFullPPNmmmffffffiuiuhjuughbghhuiiujjint} we have
\begin{equation}\label{MaxVacFullPPNmmmffffffiuiuhjuughbghhuiiujjjjjjjjgjjgint}
U(\vec x,t)=A(\vec x,t)e^{ik_0S(\vec x,t)}\,,
\end{equation}
then, considering the approximations of $A$ and $S$ as in
\er{MaxVacFullPPNmmmffffffiuiuhjuughbghhiuijghghghhhfhhghghguygtjjjkkjjkkhhint},
\er{MaxVacFullPPNmmmffffffiuiuhjuughbghhiuijghghghhjhjhhghyuyjjjhhjhjffint}
and putting them into
\er{MaxVacFullPPNmmmffffffiuiuhjuughbghhuiiujjjjjjjjgjjgint} we
actually establish $U(\vec x,t)$ only in the \underline{rough}
Geometric Optics approximation  \er{slochangGGaaffgfgjhjggh11int}.
Finally, denoting
\begin{equation}\label{MaxVacFullPPNmmmffffffiuiuhjuughbghhiuijghghghhhgjgfghjdfjgddg113399int}
\vec h(\vec x,t):=\vec {\tilde u}(\vec x,t)-c^2_0(\vec
x,t)\left(\frac{\partial S}{\partial t}(\vec x,t)+\vec {\tilde
u}(\vec x,t)\cdot\nabla_{\vec x}S(\vec x,t)\right)^{-1}\nabla_{\vec
x}S(\vec x,t),
\end{equation}
and
\begin{multline}\label{MaxVacFullPPNmmmffffffiuiuhjuughbghhiuijghghghhhfhhghghguygtjuuujjjkhjhjh11kklkloiouu3399bbint}
G\left(\vec x,t\right):=\\
\left(\frac{c^2_0}{2\left(\frac{\partial S}{\partial t}+\vec {\tilde
u}\cdot\nabla_{\vec x}S\right)}\left(\Delta_{\vec
x}S-\frac{\partial}{\partial
t}\left\{\frac{1}{c^2_0}\left(\frac{\partial S}{\partial t}+\vec
{\tilde u}\cdot\nabla_{\vec x}S\right)\right\}-\Div_{\vec x}
\left\{\frac{1}{c^2_0}\left(\frac{\partial S}{\partial t}+\vec
{\tilde u}\cdot\nabla_{\vec x} S\right)\vec {\tilde
u}\right\}\right)\right) ,
\end{multline}
we rewrite
\er{MaxVacFullPPNmmmffffffiuiuhjuughbghhiuijghghghhhfhhghghguygtjjjkkjjkkhhint}
and
\er{MaxVacFullPPNmmmffffffiuiuhjuughbghhiuijghghghhjhjhhghyuyjjjhhjhjffint}
as:
\begin{equation}\label{MaxVacFullPPNmmmffffffiuiuhjuughbghhiuijghghghhhfhhghghguygtjjjkkjjkkhhklklkjk33hhjjjljint}
\frac{\partial A}{\partial t}(\vec x,t)
+\vec h(\vec x,t)\cdot\nabla_{\vec x}A(\vec x,t)
-G(\vec x,t) A(\vec x,t)=0\,,
\end{equation}
and
\begin{equation}\label{MaxVacFullPPNmmmffffffiuiuhjuughbghhiuijghghghhhgjgfghjdfjgddg1133khhujuuuo99klklljljljint}
\left|\vec h(\vec x,t)-\vec {\tilde u}(\vec x,t)\right|^2=c^2_0(\vec
x,t)\,.
\end{equation}

Next consider the curve $\vec r(t):[t_0,b]\to\mathbb{R}^3$,
parameterized by the time variable $t$, defined as a solution of
ordinary differential equation:
\begin{equation}\label{MaxVacFullPPNmmmffffffiuiuhjuughbghhiuijghghghhhfhhghghguygtjuuujjjkk113int}
\begin{cases}
\frac{d\vec r}{dt}(t)=\vec h\left(\vec r(t),t\right)\\
\vec r(t_0)=\vec x_0,
\end{cases}
\end{equation}
where $\vec h$ was defined in
\er{MaxVacFullPPNmmmffffffiuiuhjuughbghhiuijghghghhhgjgfghjdfjgddg113399int},
then, by
\er{MaxVacFullPPNmmmffffffiuiuhjuughbghhiuijghghghhhfhhghghguygtjjjkkjjkkhhklklkjk33hhjjjljint}
and the Chain rule we have:
\begin{equation}\label{MaxVacFullPPNmmmffffffiuiuhjuughbghhiuijghghghhhfhhghghguygtjuuujjjkhjhjh11nint}
\frac{d}{dt}\left(A\left(\vec r(t),t\right)\right)=\nabla_{\vec
x}A\left(\vec r(t),t\right)\cdot\frac{d\vec r}{dt}(t)+\frac{\partial
A}{\partial t}\left(\vec r(t),t\right)= G\left(\vec
r(t),t\right)A\left(\vec r(t),t\right),
\end{equation}
where $G$ was defined in
\er{MaxVacFullPPNmmmffffffiuiuhjuughbghhiuijghghghhhfhhghghguygtjuuujjjkhjhjh11kklkloiouu3399bbint}.
Then
\er{MaxVacFullPPNmmmffffffiuiuhjuughbghhiuijghghghhhfhhghghguygtjuuujjjkhjhjh11nint}
implies
\begin{equation}\label{MaxVacFullPPNmmmffffffiuiuhjuughbghhiuijghghghhhfhhghghguygtjuuujjjkhjhjhffgg11int}
A\left(\vec r(t),t\right)=A\left(\vec
x_0,t_0\right)e^{\int_{t_0}^{t}G\left(\vec
r(\tau),\tau\right)d\tau}\quad\quad\forall t\in[t_0,b].
\end{equation}
In particular,
\begin{multline}\label{MaxVacFullPPNmmmffffffiuiuhjuughbghhiuijghghghhhfhhghghguygtjuuujjjkhjhjhffggiouuiiu113int}
A\left(\vec x_0,t_0\right)=0\;\;\text{implies}\;\; A\left(\vec
r(t),t\right)=0\quad\forall t\in[t_0,b],\\ \quad\text{and}\quad
A\left(\vec x_0,t_0\right)\neq 0\;\;\text{implies}\;\; A\left(\vec
r(t),t\right)\neq 0\quad\forall t\in[t_0,b].
\end{multline}
Therefore, by
\er{MaxVacFullPPNmmmffffffiuiuhjuughbghhiuijghghghhhfhhghghguygtjuuujjjkhjhjhffggiouuiiu113int}
we deduce that the curve that satisfies
\er{MaxVacFullPPNmmmffffffiuiuhjuughbghhiuijghghghhhfhhghghguygtjuuujjjkk113int}
coincides with the ray of light that passes through the point $\vec
x_0$ at the instant of time $t_0$. So, equality
\er{MaxVacFullPPNmmmffffffiuiuhjuughbghhiuijghghghhhfhhghghguygtjuuujjjkk113int}
is the equation of a ray and the vector field $\vec h$ defined for
every $\vec x$ by
\er{MaxVacFullPPNmmmffffffiuiuhjuughbghhiuijghghghhhgjgfghjdfjgddg113399int}
is the direction of the propagation of the ray that passes through
point $\vec x$ at the instant of time $t$. On the other hand, as
before, we can easily prove that the vector field defined in every
inertial or non-inertial coordinate system by
\er{MaxVacFullPPNmmmffffffiuiuhjuughbghhiuijghghghhhgjgfghjdfjgddg113399int}
is a \underline{speed-like} vector field. Moreover, by
\er{MaxVacFullPPNmmmffffffiuiuhjuughbghhiuijghghghhhgjgfghjdfjgddg1133khhujuuuo99klklljljljint}
the following implication holds:
\begin{equation}\label{MaxVacFullPPNmmmffffffiuiuhjuughbghhiuijghghghhhfhhghghguygtjuuujjjkhjhjhffggiouuiiu11mljjjkk;kint}
\vec {\tilde u}=0\quad\text{implies}\quad|\vec h|=c_0.
\end{equation}
Finally, by chain rule, for the curve that satisfies
\er{MaxVacFullPPNmmmffffffiuiuhjuughbghhiuijghghghhhfhhghghguygtjuuujjjkk11}
we have:
\begin{equation}\label{MaxVacFullPPNmmmffffffiuiuhjuughbghhiuijghghghhhfhhghghguygtjuuujjjkhjhjhffggiouuiiu11mljjjkk;kljjkljljhjint}
\frac{d}{dt}\left(S\left(\vec r(t),t\right)\right)=\frac{\partial
S}{\partial t}\left(\vec r(t),t\right)+\nabla_{\vec x}S\left(\vec
r(t),t\right)\cdot\frac{d\vec r}{dt}(t)=\frac{\partial S}{\partial
t}\left(\vec r(t),t\right)+\nabla_{\vec x}S\left(\vec
r(t),t\right)\cdot\vec h\left(\vec r(t),t\right).
\end{equation}
Thus, by
\er{MaxVacFullPPNmmmffffffiuiuhjuughbghhiuijghghghhhgjgfghjdfjgddg113399int},
\er{MaxVacFullPPNmmmffffffiuiuhjuughbghhiuijghghghhhgjgfghjdfjgddg1133khhujuuuo99klklljljljint}
and
\er{MaxVacFullPPNmmmffffffiuiuhjuughbghhiuijghghghhhfhhghghguygtjuuujjjkhjhjhffggiouuiiu11mljjjkk;kljjkljljhjint}
we deduce
\begin{equation}\label{MaxVacFullPPNmmmffffffiuiuhjuughbghhiuijghghghhhfhhghghguygtjuuujjjkhjhjhffggiouuiiu11mljjjkk;kljjkljljhjklklklklint}
\frac{d}{dt}\left(S\left(\vec
r(t),t\right)\right)=0\quad\quad\forall t\geq t_0,
\end{equation}
and so
\begin{equation}\label{MaxVacFullPPNmmmffffffiuiuhjuughbghhiuijghghghhhfhhghghguygtjuuujjjkhjhjhffggiouuiiu11mljjjkk;kljjkljljhjklklklkliioint}
S\left(\vec r(t),t\right)=S\left(\vec
x_0,t_0\right)\quad\quad\forall t\geq t_0.
\end{equation}
By all these facts, vector field $\vec h(\vec x,t)$ defined by
\er{MaxVacFullPPNmmmffffffiuiuhjuughbghhiuijghghghhhgjgfghjdfjgddg113399}
can be considered as the vector of the velocity (speed) of the wave
at the point $\vec x$ at the instant of time $t$. Moreover, by
\er{MaxVacFullPPNmmmffffffiuiuhjuughbghhiuijghghghhhgjgfghjdfjgddg1133khhujuuuo99klklljljljint}
we have
\begin{equation}\label{MaxVacFullPPNmmmffffffiuiuhjuughbghhiuijghghghhjhjhhghyuyjjjhhjhjffkklkljljjkhhjkjhghghghl;kl;;k;kkljjkjkll;l;int}
\left|\vec h-\vec{\tilde u}\right|^2=c^2_0.
\end{equation}
\begin{remark}\label{hjgjgghjj11nnjjint}
In contrast to the proof of
\er{MaxVacFullPPNmmmffffffiuiuhjuughbghhiuijghghghhjhjhhghyuyjjjhhjhjffint},
we do not use any of the Geometric Optics approximations
\er{slochangGGaaffgfgjhjgghint} or \er{slochangGGaaffgfgjhjggh11int}
in the proof of
\er{MaxVacFullPPNmmmffffffiuiuhjuughbghhiuijghghghhhfhhghghguygtjjjkkjjkkhhint}
and
\er{MaxVacFullPPNmmmffffffiuiuhjuughbghhiuijghghghhhfhhghghguygtjjjkkjjkkhhklklkjk33hhjjjljint}.
So,
\er{MaxVacFullPPNmmmffffffiuiuhjuughbghhiuijghghghhhfhhghghguygtjjjkkjjkkhhklklkjk33hhjjjljint}
is still valid without assumptions of the Geometric Optics
approximation \er{slochangGGaaffgfgjhjgghint} or
\er{slochangGGaaffgfgjhjggh11int} and thus, the vector field $\vec
h$, defined by
\er{MaxVacFullPPNmmmffffffiuiuhjuughbghhiuijghghghhhgjgfghjdfjgddg113399int},
is the direction of the propagation of the ray that passes through
point $\vec x$ also in the general case. However, without
assumptions of the Geometric Optics approximation we cannot derive
\er{MaxVacFullPPNmmmffffffiuiuhjuughbghhiuijghghghhjhjhhghyuyjjjhhjhjffint},
\er{MaxVacFullPPNmmmffffffiuiuhjuughbghhiuijghghghhhfhhghghguygtjuuujjjkhjhjhffggiouuiiu11mljjjkk;kint},
\er{MaxVacFullPPNmmmffffffiuiuhjuughbghhiuijghghghhhfhhghghguygtjuuujjjkhjhjhffggiouuiiu11mljjjkk;kljjkljljhjklklklkliioint}
and
\er{MaxVacFullPPNmmmffffffiuiuhjuughbghhiuijghghghhjhjhhghyuyjjjhhjhjffkklkljljjkhhjkjhghghghl;kl;;k;kkljjkjkll;l;int}
anymore.
\end{remark}

\subsubsection{The case of the monochromatic wave}
Next, up to the end of this subsection, consider the case of
monochromatic wave of the constant frequency
$\nu=\frac{\omega}{2\pi}$ where the fields $\vec {\tilde u}$ and
$c_0$ are independent of the time variable i.e. the case of
\er{MaxVacFullPPNmmmffffffiuiuhjuughbghhuiiujjjjjjjjint} where we
have
\begin{equation}\label{MaxVacFullPPNmmmffffffiuiuhjuughbghhiuijghghghhjhjhhghyuyhhjhjihikjjjjkjljklint}
\begin{cases}
\frac{\partial T}{\partial t}=\omega\\
\frac{\partial A}{\partial t}=0\\
\frac{\partial \vec {\tilde u}}{\partial t}=0\\
\frac{\partial c_0}{\partial t}=0.
\end{cases}
\end{equation}
We also assume that either our medium has no dispersion (i.e. in the
case of an electromagnetic wave \er{EBDHTrans444knbuihuigGGint} is
valid), or the velocity of the medium is negligible i.e. $\vec
u\equiv 0$ (and so in the case of an electromagnetic wave we have
$\vec{\tilde u}\equiv \vec v$) and our medium is transparent for the
given frequency $\nu=\frac{\omega}{2\pi}$. Then, by
\er{MaxVacFullPPNmmmffffffiuiuhjuughbghhuiiujjjjjjjjhhhjjjint} and
\er{MaxVacFullPPNmmmffffffiuiuhjuughbghhuiiujjjjjjjjhhhjjjkkint} we
rewrite
\er{MaxVacFullPPNmmmffffffiuiuhjuughbghhiuijghghghhjhjhhghyuyhhjhjihikjjjjkjljklint}
as
\begin{equation}\label{MaxVacFullPPNmmmffffffiuiuhjuughbghhiuijghghghhjhjhhghyuyhhjhjihikjjjint}
\begin{cases}
\frac{\partial S}{\partial t}=c\\
\frac{\partial A}{\partial t}=0.
\end{cases}
\end{equation}
Thus $\nabla_{\vec x}S$ is independent of $t$ and moreover, by
\er{MaxVacFullPPNmmmffffffiuiuhjuughbghhiuijghghghhjhjhhghyuyhhjhjihikjjjint}
we rewrite
\er{MaxVacFullPPNmmmffffffiuiuhjuughbghhiuijghghghhjhjhhghyuyjjjhhjhjffint}
as:
\begin{equation}\label{MaxVacFullPPNmmmffffffiuiuhjuughbghhiuijghghghhjhjhhghyuyiyyujjint}
\frac{c^2}{c^2_0}\left(1+\frac{1}{c}\vec {\tilde u}\cdot\nabla_\vec
x S\right)^2=\left|\nabla_\vec x S\right|^2,
\end{equation}
and, using
\er{MaxVacFullPPNmmmffffffiuiuhjuughbghhiuijghghghhjhjhhghyuyhhjhjihikjjjint}
we rewrite
\er{MaxVacFullPPNmmmffffffiuiuhjuughbghhiuijghghghhhfhhghghguygtjjjkkjjkkhhint}
as:
\begin{equation}\label{MaxVacFullPPNmmmffffffiuiuhjuughbghhiuijghghghhhfhhghghguygtjuuujjint}
2\left(\nabla_{\vec x}S-\frac{c}{c_0}\left(1+\frac{1}{c}\left(\vec
{\tilde u}\cdot\nabla_{\vec x}S\right)\right)\frac{\vec {\tilde
u}}{c_0}\right)\cdot\nabla_{\vec x}A=\left(\Div_{\vec
x}\left\{\frac{1}{c^2_0}\left(c+\vec {\tilde u}\cdot\nabla_{\vec
x}S\right)\vec {\tilde u}\right\}-\left(\Delta_{\vec
x}S\right)\right)A.
\end{equation}

In particular, in the case of the region of the space where the
following approximation is valid:
\begin{equation}\label{ojhkkint}
\frac{|\vec {\tilde u}|^2}{c^2_0}\ll 1,
\end{equation}
up to order $O\left(\frac{|\vec {\tilde u}|^2}{c^2_0}\right)$, we
rewrite
\er{MaxVacFullPPNmmmffffffiuiuhjuughbghhiuijghghghhjhjhhghyuyiyyujjint}
as:
\begin{equation}\label{MaxVacFullPPNmmmffffffiuiuhjuughbghhiuijghghghhjhjhhghyuyiyyujjljkint}
\left|\frac{c\vec {\tilde u}}{c^2_0}-\nabla_\vec x
S\right|^2=\frac{c^2}{c^2_0},
\end{equation}
and
\er{MaxVacFullPPNmmmffffffiuiuhjuughbghhiuijghghghhhfhhghghguygtjuuujjint}
as:
\begin{equation}\label{MaxVacFullPPNmmmffffffiuiuhjuughbghhiuijghghghhhfhhghghguygtjuuujjjkint}
\left(\frac{c\vec {\tilde u}}{c^2_0}-\nabla_{\vec
x}S\right)\cdot\nabla_{\vec x}A+\frac{1}{2}\left(\Div_{\vec
x}\left\{\frac{c}{c^2_0}\vec {\tilde u}\right\}-\Delta_{\vec
x}S\right)A=0.
\end{equation}
The Eikonal equation
\er{MaxVacFullPPNmmmffffffiuiuhjuughbghhiuijghghghhjhjhhghyuyiyyujjljkint}
and equation of the beam propagation
\er{MaxVacFullPPNmmmffffffiuiuhjuughbghhiuijghghghhhfhhghghguygtjuuujjjkint}
are two basic equations of propagation of monochromatic light in the
Geometric Optics approximation inside a moving medium or/and in the
presence of non-trivial gravitational field, provided that the field
$\vec {\tilde u}$ satisfies \er{ojhkkint}.

Next if we consider an arbitrary characteristic curve $\vec
r(s):[a,b]\to\mathbb{R}^3$ of equation
\er{MaxVacFullPPNmmmffffffiuiuhjuughbghhiuijghghghhhfhhghghguygtjuuujjjkint}
defined as a solution of ordinary differential equation
\begin{equation}\label{MaxVacFullPPNmmmffffffiuiuhjuughbghhiuijghghghhhfhhghghguygtjuuujjjkkint}
\begin{cases}
\frac{d\vec r}{ds}(s)=\frac{c}{c^2_0\left(\vec r(s)\right)}\vec
{\tilde u}\left(\vec
r(s)\right)-\nabla_{\vec x}S\left(\vec r(s)\right)\\
\vec r(a)=\vec x_0,
\end{cases}
\end{equation}
then, as before, by
\er{MaxVacFullPPNmmmffffffiuiuhjuughbghhiuijghghghhhfhhghghguygtjuuujjjkint}
and
\er{MaxVacFullPPNmmmffffffiuiuhjuughbghhiuijghghghhhfhhghghguygtjuuujjjkkint}
we have
\begin{equation}\label{MaxVacFullPPNmmmffffffiuiuhjuughbghhiuijghghghhhfhhghghguygtjuuujjjkhjhjhint}
\frac{d}{ds}\left(A\left(\vec r(s)\right)\right)=\nabla_{\vec
x}A\left(\vec r(s)\right)\cdot\frac{d\vec
r}{ds}(s)=\frac{1}{2}\left(\Delta_{\vec x}S\left(\vec
r(s)\right)-\Div_{\vec x}\left\{\frac{c}{c^2_0\left(\vec
r(s)\right)}\vec {\tilde u}\left(\vec
r(s)\right)\right\}\right)A\left(\vec r(s)\right),
\end{equation}
that implies
\begin{equation}\label{MaxVacFullPPNmmmffffffiuiuhjuughbghhiuijghghghhhfhhghghguygtjuuujjjkhjhjhffggint}
A\left(\vec r(s)\right)=A\left(\vec
x_0\right)e^{\frac{1}{2}\int_a^{s}\left(\Delta_{\vec x}S\left(\vec
r(\tau)\right)-\Div_{\vec x}\left\{\frac{c}{c^2_0\left(\vec
r(\tau)\right)}\vec {\tilde u}\left(\vec
r(\tau)\right)\right\}\right)d\tau}\quad\quad\forall s\in[a,b].
\end{equation}
In particular,
\begin{equation}\label{MaxVacFullPPNmmmffffffiuiuhjuughbghhiuijghghghhhfhhghghguygtjuuujjjkhjhjhffggiouuiiuint}
A\left(\vec x_0\right)=0\;\;\text{implies}\;\; A\left(\vec
r(s)\right)=0\quad\forall s\in[a,b],\quad\text{and}\quad A\left(\vec
x_0\right)\neq 0\;\;\text{implies}\;\; A\left(\vec r(s)\right)\neq
0\quad\forall s\in[a,b].
\end{equation}
Therefore, by
\er{MaxVacFullPPNmmmffffffiuiuhjuughbghhiuijghghghhhfhhghghguygtjuuujjjkhjhjhffggiouuiiuint}
we deduce that in the case of \er{ojhkkint} the curve that satisfies
\er{MaxVacFullPPNmmmffffffiuiuhjuughbghhiuijghghghhhfhhghghguygtjuuujjjkkint}
coincides with the ray of light that passes through the point $\vec
x_0$. So in the case of \er{ojhkkint}, equality
\er{MaxVacFullPPNmmmffffffiuiuhjuughbghhiuijghghghhhfhhghghguygtjuuujjjkkint}
is the equation of a ray and the vector field $\vec h_0$ defined for
every $\vec x$ by:
\begin{equation}\label{MaxVacFullPPNmmmffffffiuiuhjuughbghhiuijghghghhhfhhghghguygtjuuujjjkyuuyint}
\vec h_0(\vec x):=\frac{c}{c^2_0(\vec x)}\vec {\tilde u}(\vec
x)-\nabla_{\vec x}S(\vec x)\approx\frac{c}{c^2_0(\vec x)}\vec h(\vec
x),
\end{equation}
is the direction of the propagation of the ray that passes through
point $\vec x$. Moreover, by
\er{MaxVacFullPPNmmmffffffiuiuhjuughbghhiuijghghghhjhjhhghyuyiyyujjljkint}
$\vec h_0$ satisfies
\begin{equation}\label{MaxVacFullPPNmmmffffffiuiuhjuughbghhiuijghghghhjhjhhghyuyiyyujjljkgghhgint}
|\vec h_0|^2=\frac{c^2}{c^2_0}.
\end{equation}
\begin{remark}\label{hjgjgghint}
In contrast to the proof of
\er{MaxVacFullPPNmmmffffffiuiuhjuughbghhiuijghghghhjhjhhghyuyjjjhhjhjffint},
\er{MaxVacFullPPNmmmffffffiuiuhjuughbghhiuijghghghhjhjhhghyuyiyyujjint}
or
\er{MaxVacFullPPNmmmffffffiuiuhjuughbghhiuijghghghhjhjhhghyuyiyyujjljkint},
we do not use the Geometric Optics approximations
\er{slochangGGaaffgfgjhjgghint} or \er{slochangGGaaffgfgjhjggh11int}
in the proof of
\er{MaxVacFullPPNmmmffffffiuiuhjuughbghhiuijghghghhhfhjjint},
\er{MaxVacFullPPNmmmffffffiuiuhjuughbghhiuijghghghhhfhhghghguygtjjjkkjjkkhhint},
\er{MaxVacFullPPNmmmffffffiuiuhjuughbghhiuijghghghhhfhhghghguygtjuuujjint}
and
\er{MaxVacFullPPNmmmffffffiuiuhjuughbghhiuijghghghhhfhhghghguygtjuuujjjkint}.
We just need the estimation \er{ojhkkint} for the proof of
\er{MaxVacFullPPNmmmffffffiuiuhjuughbghhiuijghghghhhfhhghghguygtjuuujjjkint}.
So,
\er{MaxVacFullPPNmmmffffffiuiuhjuughbghhiuijghghghhhfhhghghguygtjuuujjjkhjhjhffggiouuiiuint}
is still valid without assumptions of the Geometric Optics
approximation \er{slochangGGaaffgfgjhjgghint} or
\er{slochangGGaaffgfgjhjggh11int} and thus, the vector field $\vec
h_0$, defined by
\er{MaxVacFullPPNmmmffffffiuiuhjuughbghhiuijghghghhhfhhghghguygtjuuujjjkyuuyint},
is the direction of the propagation of the ray that passes through
point $\vec x$ also in the general case, provided the estimation
\er{ojhkkint} holds. However, without assumption of the Geometric
Optics approximation we cannot derive
\er{MaxVacFullPPNmmmffffffiuiuhjuughbghhiuijghghghhjhjhhghyuyiyyujjljkgghhgint}
anymore.
\end{remark}
Next by
\er{MaxVacFullPPNmmmffffffiuiuhjuughbghhiuijghghghhjhjhhghyuyiyyujjljkint}
and
\er{MaxVacFullPPNmmmffffffiuiuhjuughbghhiuijghghghhhfhhghghguygtjuuujjjkint}
in subsection \ref{GO} we prove the following Fermat Principle:
\begin{proposition}\label{gughghfint}
Assume Geometric Optics approximation together with \er{ojhkkint}.
Then the light that travels from point $N$ to point $M$ chooses the
path $\vec r(s):[a,b]\to\mathbb{R}^3$ with endpoints $\vec r(a)=N$
and $\vec r(b)=M$ which minimizes the quantity:
\begin{equation}\label{MaxMedFullGGffgggyyojjhhjkhjyyiuhggjhhjhuyytytyuuytrrtghjtyuggyuighjuyioyyfgffhyuhhghzzrrkkhhkkkhhhjhkjhhghhgghint}
J\left(\vec r(\cdot)\right):=\int_a^bn\left(\vec
r(s)\right)\left|\vec r'(s)\right|ds-\int_a^b
\frac{1}{c}n^2\left(\vec r(s)\right)\vec {\tilde u}\left(\vec
r(s)\right)\cdot\vec r'(s)ds,
\end{equation}
where we set the refraction index:
\begin{equation}\label{MaxMedFullGGffgggyyojjhhjkhjyyiuhggjhhjhuyytytyuuytrrtghjtyuggyuighjuyioyyfgffhyuhhghzzrrkkhhkkkhhhjhkjhhghhgghiuiu1int}
n\left(\vec x\right):=\frac{c}{c_0\left(\vec x\right)}.
\end{equation}
Moreover, if the path $\vec r(s):[a,b]\to\mathbb{R}^3$ with
endpoints $\vec r(a)=N$ and $\vec r(b)=M$ is the real path of the
light, then:
\begin{equation}\label{MaxMedFullGGffgggyyojjhhjkhjyyiuhggjhhjhuyytytyuuytrrtghjtyuggyuighjuyioyyfgffhyuhhghzzrrkkhhkkkint}
\left(-S(M)\right)- \left(-S(N)\right)=\int_a^bn\left(\vec
r(s)\right)\left|\vec r'(s)\right|ds-\int_a^b
\frac{1}{c}n^2\left(\vec r(s)\right)\vec {\tilde u}\left(\vec
r(s)\right)\cdot\vec r'(s)ds.
\end{equation}
\end{proposition}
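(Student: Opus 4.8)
The plan is to reduce the entire statement to two facts already established for the monochromatic geometric-optics regime, combined with the Cauchy--Schwarz inequality. The first fact is the Eikonal equation \er{MaxVacFullPPNmmmffffffiuiuhjuughbghhiuijghghghhjhjhhghyuyiyyujjljkint}, equivalently \er{MaxVacFullPPNmmmffffffiuiuhjuughbghhiuijghghghhjhjhhghyuyiyyujjljkgghhgint}, which says that the beam-direction field $\vec h$ of \er{MaxVacFullPPNmmmffffffiuiuhjuughbghhiuijghghghhhfhhghghguygtjuuujjjkyuuyint} has length $|\vec h|=c/c_0=n$ at every point; the second is the fundamental theorem of calculus applied to the phase $S$ along a curve. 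First I would rewrite the functional: since $n=c/c_0$ by \er{MaxMedFullGGffgggyyojjhhjkhjyyiuhggjhhjhuyytytyuuytrrtghjtyuggyuighjuyioyyfgffhyuhhghzzrrkkhhkkkhhhjhkjhhghhgghiuiu1int}, we have $\tfrac1c n^2\vec {\tilde u}=\tfrac{c}{c^2_0}\vec {\tilde u}$, so that the quantity defined in \er{MaxMedFullGGffgggyyojjhhjkhjyyiuhggjhhjhuyytytyuuytrrtghjtyuggyuighjuyioyyfgffhyuhhghzzrrkkhhkkkhhhjhkjhhghhgghint} becomes
\[
J(\vec r(\cdot))=\int_a^b\frac{c}{c_0(\vec r(s))}\,|\vec r'(s)|\,ds-\int_a^b\frac{c}{c^2_0(\vec r(s))}\,\vec {\tilde u}(\vec r(s))\cdot\vec r'(s)\,ds .
\]

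Next, for \emph{any} curve $\vec r$ with $\vec r(a)=N$ and $\vec r(b)=M$, the fundamental theorem of calculus gives $\int_a^b\nabla_{\vec x}S(\vec r(s))\cdot\vec r'(s)\,ds=S(M)-S(N)$, a number depending only on the endpoints. I would therefore compare $J$ against this fixed value by writing
\[
J(\vec r(\cdot))-\big((-S(M))-(-S(N))\big)=\int_a^b\Big(\frac{c}{c_0(\vec r(s))}\,|\vec r'(s)|-\vec h(\vec r(s))\cdot\vec r'(s)\Big)\,ds,
\]
where the definition of $\vec h$ has been used to collapse the $\vec {\tilde u}$-term and the $\nabla_{\vec x}S$-term into the single product $\vec h\cdot\vec r'$. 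By Cauchy--Schwarz together with $|\vec h|=c/c_0$ the integrand is pointwise nonnegative, so $J(\vec r(\cdot))\ge (-S(M))-(-S(N))$ for every competing path; this is the minimization claim, with the right-hand side the common minimal value.

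To finish I would identify the minimizers with the genuine light beams and verify the equality. Equality in Cauchy--Schwarz forces $\vec r'(s)$ to be a nonnegative multiple of $\vec h(\vec r(s))$ at each $s$, i.e.\ $\vec r$ solves the characteristic equation \er{MaxVacFullPPNmmmffffffiuiuhjuughbghhiuijghghghhhfhhghghguygtjuuujjjkkint} up to reparametrization; along such a curve the integrand above vanishes identically, giving exactly the second assertion $(-S(M))-(-S(N))=J(\vec r(\cdot))$. That the characteristics of the transport equation \er{MaxVacFullPPNmmmffffffiuiuhjuughbghhiuijghghghhhfhhghghguygtjuuujjjkint} are the real beams is precisely the content of \er{MaxVacFullPPNmmmffffffiuiuhjuughbghhiuijghghghhhfhhghghguygtjuuujjjkhjhjhffggiouuiiuint}, established earlier.

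I do not expect a genuine analytic obstacle here, since all the computations are elementary; the only point requiring care is the equality case. One must check that the minimizing curve is not merely parallel to $\vec h$ but co-oriented with it, so that $\vec h\cdot\vec r'=|\vec h|\,|\vec r'|$ holds with the correct sign, and that reparametrization leaves $J$ unchanged, which is immediate because both integrands are positively homogeneous of degree one in $\vec r'$. Thus the conceptual step is the identification of the Cauchy--Schwarz equality locus with the beam traced out by the characteristic flow, rather than any delicate estimate.
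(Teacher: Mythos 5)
Your proposal is correct and follows essentially the same route as the paper: the paper likewise integrates the square root of the Eikonal equation \er{MaxVacFullPPNmmmffffffiuiuhjuughbghhiuijghghghhjhjhhghyuyiyyujjljkint} along an arbitrary competing curve, applies the Cauchy--Schwarz inequality $\vec a\cdot\vec b\leq|\vec a|\,|\vec b|$ together with $\int_a^b\nabla_{\vec x}S\cdot\vec r'\,ds=S(M)-S(N)$ to obtain $J\geq(-S(M))-(-S(N))$, and then observes that equality holds precisely when $\vec r'(s)=\sigma(s)\vec h(\vec r(s))$ with $\sigma\geq 0$, i.e.\ for reparametrized characteristics of \er{MaxVacFullPPNmmmffffffiuiuhjuughbghhiuijghghghhhfhhghghguygtjuuujjjkkint}, which are identified with the real beams via \er{MaxVacFullPPNmmmffffffiuiuhjuughbghhiuijghghghhhfhhghghguygtjuuujjjkhjhjhffggiouuiiuint}. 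Your reorganization (subtracting the endpoint-fixed value first and exhibiting the difference as a pointwise nonnegative integrand) is only a cosmetic rearrangement of the same argument, and your attention to co-orientation and degree-one homogeneity matches the paper's use of a nonnegative factor $\sigma(s)$.
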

See also subsection \ref{GO} for the generalization of the Fermat
Principle to the case where we cannot take \er{ojhkkint} into
account.

In particular, by Proposition \ref{gughghfint} the path of travel of
the light satisfies the Euler-Lagrange equation for the functional
$J\left(\vec r(\cdot)\right)$,
that is the differential equation of the path of light:
\begin{multline}\label{MaxMedFullGGffgggyyojjhhjkhjyyiuhggjhhjhuyytytyuuytrrtghjtyuggyuighjuyioyyfgffhyuhhghzzrrkkhhkkkhhhjhkjhhghhgghiuiuhhjhjint}
\frac{d}{d\lambda}\left(n\left(\vec r\right)\frac{d\vec
r}{d\lambda}\right)=\frac{1}{c}n^2\left(\vec
r\right)\left(curl_{\vec x}\vec {\tilde u}\left(\vec
r\right)\right)\times\frac{d\vec r}{d\lambda}\\ +\nabla_{\vec
x}n\left(\vec r\right) +\frac{2}{c}n\left(\vec r\right)\left\{\vec
{\tilde u}\left(\vec r\right)\otimes\nabla_{\vec x}n\left(\vec
r\right)-\nabla_{\vec x}n\left(\vec r\right)\otimes\vec {\tilde
u}\left(\vec r\right)\right\}\cdot\frac{d\vec r}{d\lambda},
\end{multline}
where
\begin{equation}\label{MaxMedFullGGffgggyyojjhhjkhjyyiuhggjhhjhuyytytyuuytrrtghjtyuggyuighjuyioyyfgffhyuhhghzzrrkkhhkkkhhhjhkjhhghhgghiuiuint}
\lambda:=\int_a^s\left|\vec r'(\tau)\right|d\tau,
\end{equation}
is the natural parameter of the curve (see subsection \ref{GO} for
details).

Next, assume that the wave we consider has an electromagnetic
nature. Then by \er{gughhghfbvnbvint} and \er{uyuyuyyint} we have
\begin{equation}\label{gughhghfbvnbvyyuuyrint}
c_0=c\sqrt{\kappa_0\gamma_0}\quad\text{and}\quad\vec {\tilde
u}=\left(\gamma_0\kappa_0\vec v+(1-\gamma_0\kappa_0)\vec u\right),
\end{equation}
where, $\vec u$ is the medium velocity and $\vec v$ is the local
vectorial gravitational potential. Moreover, assume that we consider
light traveling in some region either filled with the resting medium
of constant dielectric permeability $\gamma_0$ and magnetic
permeability $\kappa_0$ or in the vacuum. Then by
\er{gughhghfbvnbvyyuuyrint} and
\er{MaxMedFullGGffgggyyojjhhjkhjyyiuhggjhhjhuyytytyuuytrrtghjtyuggyuighjuyioyyfgffhyuhhghzzrrkkhhkkkhhhjhkjhhghhgghiuiu1int}
we have:
\begin{equation}\label{gughhghfbvnbvyyuuyrhiyyuiuuint}
n=\frac{1}{\sqrt{\kappa_0\gamma_0}}\;\;\;\text{is a
constant,}\quad\text{and}\quad\vec {\tilde u}=\gamma_0\kappa_0\vec
v,
\end{equation}
Then by \er{gughhghfbvnbvyyuuyrhiyyuiuuint} we rewrite
\er{MaxMedFullGGffgggyyojjhhjkhjyyiuhggjhhjhuyytytyuuytrrtghjtyuggyuighjuyioyyfgffhyuhhghzzrrkkhhkkkhhhjhkjhhghhgghiuiuhhjhjint}
as:
\begin{equation}\label{MaxMedFullGGffgggyyojjhhjkhjyyiuhggjhhjhuyytytyuuytrrtghjtyuggyuighjuyioyyfgffhyuhhghzzrrkkhhkkkhhhjhkjhhghhgghiuiuhhjhjiuiuyuint}
\frac{d^2\vec
r}{d\lambda^2}=\frac{1}{c}\sqrt{\gamma_0\kappa_0}\,\left(curl_{\vec
x}\vec v\left(\vec r\right)\right)\times\frac{d\vec r}{d\lambda}.
\end{equation}
In particular, if our coordinate system is inertial, or more
generally non-rotating, then $curl_{\vec x}\vec v=0$ and we deduce
that the path of the light from the point $N$ to the point $M$ is
the direct line connecting these points, provided we take in the
account estimation \er{ojhkkint}.

On the other hand, if our system is rotating, then, since $\vec v$
is a speed-like vector field, we clearly deduce:
\begin{equation}\label{MaxMedFullGGffgggyyojjhhjkhjyyiuhggjhhjhuyytytyuuytrrtghjtyuggyuighjuyioyyfgffhyuhhghzzrrkkhhkkkhhhjhkjhhghhgghiuiuhhjhjiuiuyujjkint}
curl_{\vec x}\vec v=-2\vec w,
\end{equation}
where $\vec w$ is the vector of the angular speed of rotation of our
coordinate system. Thus by inserting
\er{MaxMedFullGGffgggyyojjhhjkhjyyiuhggjhhjhuyytytyuuytrrtghjtyuggyuighjuyioyyfgffhyuhhghzzrrkkhhkkkhhhjhkjhhghhgghiuiuhhjhjiuiuyujjkint}
into
\er{MaxMedFullGGffgggyyojjhhjkhjyyiuhggjhhjhuyytytyuuytrrtghjtyuggyuighjuyioyyfgffhyuhhghzzrrkkhhkkkhhhjhkjhhghhgghiuiuhhjhjiuiuyuint}
we deduce:
\begin{equation}\label{MaxMedFullGGffgggyyojjhhjkhjyyiuhggjhhjhuyytytyuuytrrtghjtyuggyuighjuyioyyfgffhyuhhghzzrrkkhhkkkhhhjhkjhhghhgghiuiuhhjhjiuiuyuyyuint}
\frac{d^2\vec
r}{d\lambda^2}=-\frac{2}{c}\sqrt{\gamma_0\kappa_0}\,\vec
w\times\frac{d\vec r}{d\lambda}.
\end{equation}
%
%
%
%
%
%
The solution of
\er{MaxMedFullGGffgggyyojjhhjkhjyyiuhggjhhjhuyytytyuuytrrtghjtyuggyuighjuyioyyfgffhyuhhghzzrrkkhhkkkhhhjhkjhhghhgghiuiuhhjhjiuiuyuyyuint}
is the following:
\begin{equation}\label{MaxMedFullGGffgggyyojjhhjkhjyyiuhggjhhjhuyytytyuuytrrtghjtyuggyuighjuyioyyfgffhyuhhghzzrrkkhhkkkhhhjhkjhhghhgghiuiuhhjhjiuiuyuyyukklghhgint}
\begin{cases}
x(\lambda)=C_1\frac{c}{2w}\sqrt{\kappa_0\gamma_0}\,\left(\cos{\left(\frac{2w}{c}\sqrt{\kappa_0\gamma_0}\,\lambda\right)}-1\right)+C_2\frac{c}{2w}\sqrt{\kappa_0\gamma_0}\,\sin{\left(\frac{2w}{c}\sqrt{\kappa_0\gamma_0}\,\lambda\right)}+D_1
\\
y(\lambda)=-C_1\frac{c}{2w}\sqrt{\kappa_0\gamma_0}\,\sin{\left(\frac{2w}{c}\sqrt{\kappa_0\gamma_0}\,\lambda\right)}+C_2\frac{c}{2w}\sqrt{\kappa_0\gamma_0}\,\left(\cos{\left(\frac{2w}{c}\sqrt{\kappa_0\gamma_0}\,\lambda\right)}-1\right)+D_2
\\
z(\lambda)=C_3\lambda+D_3,
\end{cases}
\end{equation}
where, since $\lambda$ is a natural parameter of the curve, we have:
\begin{equation}\label{MaxMedFullGGffgggyyojjhhjkhjyyiuhggjhhjhuyytytyuuytrrtghjtyuggyuighjuyioyyfgffhyuhhghzzrrkkhhkkkhhhjhkjhhghhgghiuiuhhjhjiuiuyuyyuojkint}
C^2_1+C^2_2+C^2_3=1.
\end{equation}
So, the curve in
\er{MaxMedFullGGffgggyyojjhhjkhjyyiuhggjhhjhuyytytyuuytrrtghjtyuggyuighjuyioyyfgffhyuhhghzzrrkkhhkkkhhhjhkjhhghhgghiuiuhhjhjiuiuyuyyukklghhgint}
is the trajectory of the light in the rotating coordinate system,
provided we assume \er{ojhkkint}. In particular, by
\er{MaxMedFullGGffgggyyojjhhjkhjyyiuhggjhhjhuyytytyuuytrrtghjtyuggyuighjuyioyyfgffhyuhhghzzrrkkhhkkkhhhjhkjhhghhgghiuiuhhjhjiuiuyuyyukklghhgint}
we have:
\begin{equation}\label{MaxMedFullGGffgggyyojjhhjkhjyyiuhggjhhjhuyytytyuuytrrtghjtyuggyuighjuyioyyfgffhyuhhghzzrrkkhhkkkhhhjhkjhhghhgghiuiuhhjhjiuiuyuyyuojkjkkkhguggint}
\begin{cases}
x(0)=D_1,\quad y(0)=D_2,\quad z(0)=D_3,\\
\frac{dx}{d\lambda}(0)=C_2,\quad\frac{dy}{d\lambda}(0)=-C_1,\quad\frac{dz}{d\lambda}(0)=C_3.
\end{cases}
\end{equation}
%
%
%
%
%
%
The constants $C_1,C_2,C_3,D_1,D_2,D_3$ can be determined either by
the initial data
\er{MaxMedFullGGffgggyyojjhhjkhjyyiuhggjhhjhuyytytyuuytrrtghjtyuggyuighjuyioyyfgffhyuhhghzzrrkkhhkkkhhhjhkjhhghhgghiuiuhhjhjiuiuyuyyuojkjkkkhguggint}
or by the beginning and the ending points $N$ and $M$ of the curve.

\subsubsection{The laws of reflection and refraction}
Next consider a monochromatic wave of the frequency
$\nu=\omega/(2\pi)$ characterized by:
\begin{equation}\label{MaxVacFullPPNmmmffffffiuiuhjuughbghhuiiujjhhjjhjhhjint}
U(\vec x,t)=A(\vec x)e^{ik_0S(\vec x,t)},\quad\text{where}\quad
k_0=\frac{\omega}{c}\quad\text{and}\quad\frac{\partial S}{\partial
t}=c\,,
\end{equation}
and, consistently with
\er{MaxVacFullPPNmmmffffffiuiuhjuughbghhiuijghghghhhfhhghghguygtjuuujjjkyuuyint}
consider a direction field:
\begin{equation}\label{MaxVacFullPPNmmmffffffiuiuhjuughbghhiuijghghghhhfhhghghguygtjuuujjjkyuuykjkjjint}
\vec h_0(\vec x)=\frac{c}{c^2_0(\vec x)}\vec {\tilde u}(\vec
x)-\nabla_{\vec x}S(\vec x)\approx \frac{c}{c^2_0(\vec x)}\vec
h(\vec x).
\end{equation}
Furthermore, assume that this wave undergoes reflection and/or
refraction on the stationary (time independent) surface
$\mathcal{T}$ with the outcoming unit normal $\vec n$, separating
two regions characterized respectively by $c_0=c^{(1)}_0$ and $\vec
{\tilde u}=\vec {\tilde u}_1$ and by $c^{(2)}_0$ and $\vec {\tilde
u}_2$, with the formation of the reflected wave (of the same
frequency), characterized by:
\begin{equation}\label{MaxVacFullPPNmmmffffffiuiuhjuughbghhuiiujjhhjjhjhhjughint}
U_1(\vec x,t)=A_1(\vec x)e^{ik_0S_1(\vec
x,t)},\quad\text{where}\quad\frac{\partial S_1}{\partial t}=c\,,
\end{equation}
and by a direction field:
\begin{equation}\label{MaxVacFullPPNmmmffffffiuiuhjuughbghhiuijghghghhhfhhghghguygtjuuujjjkyuuykjkjjhjhhjint}
\vec h_1(\vec x)=\frac{c}{c^2_0(\vec x)}\vec {\tilde u}(\vec
x)-\nabla_{\vec x}S_1(\vec x),
\end{equation}
and formation of the refracted wave (of the same frequency),
characterized by:
\begin{equation}\label{MaxVacFullPPNmmmffffffiuiuhjuughbghhuiiujjhhjjhjhhj1int}
U_2(\vec x,t)=A_2(\vec x)e^{ik_0S_2(\vec
x,t)},\quad\text{where}\quad\frac{\partial S_2}{\partial t}=c\,.
\end{equation}
and by a direction field:
\begin{equation}\label{MaxVacFullPPNmmmffffffiuiuhjuughbghhiuijghghghhhfhhghghguygtjuuujjjkyuuykjkjj1int}
\vec h_2(\vec x)=\frac{c}{\left(c^{(2)}_0(\vec x)\right)^2}\vec
{\tilde u_2}(\vec x)-\nabla_{\vec x}S_2(\vec x).
\end{equation}
Then the boundary conditions of $U$, $U_1$ and $U_2$ depend on the
physical meaning of these fields. However, one of the
\underline{necessary} conditions should be that
\begin{equation}\label{MaxMedFullGGffgggyyojjyugggjhhjiiuuiyuyuyyuyuzzhhkkkint}
S_1(\vec x,t)=S_2(\vec x,t)+C_2=S(\vec x,t)\quad\quad\forall\vec
x\in\mathcal{T},
\end{equation}
where $C_2$ is a real constant. In particular
\er{MaxMedFullGGffgggyyojjyugggjhhjiiuuiyuyuyyuyuzzhhkkkint}
implies:
%
%
%
\begin{equation}\label{MaxMedFullGGffgggyyojjyugggjhhjiiuuiyuyuyyuyuzzhhkkkgdfgint}
\nabla_{\vec x} S_1-\left(\vec n\cdot \nabla_{\vec x} S_1\right)\vec
n
=\nabla_{\vec x} S_2-\left(\vec n\cdot \nabla_{\vec x}
S_2\right)\vec n=\nabla_{\vec x} S-\left(\vec n\cdot \nabla_{\vec x}
S\right)\vec n\quad\quad\forall\vec x\in\mathcal{T}.
\end{equation}
In particular, for every point on the surface $\mathcal{T}$ vectors
$\nabla_{\vec x} S_1$ and $\nabla_{\vec x} S_2$ lie in the plane
formed by vectors $\vec n$ and $\nabla_{\vec x} S$. Moreover, by
\er{MaxVacFullPPNmmmffffffiuiuhjuughbghhiuijghghghhhfhhghghguygtjuuujjjkyuuykjkjjint},
\er{MaxVacFullPPNmmmffffffiuiuhjuughbghhiuijghghghhhfhhghghguygtjuuujjjkyuuykjkjjhjhhjint}
and \er{MaxMedFullGGffgggyyojjyugggjhhjiiuuiyuyuyyuyuzzhhkkkgdfgint}
we have
%
%
%
\begin{equation}\label{MaxMedFullGGffgggyyojjyugggjhhjiiuuiyuyuyyuyuzzhhkkkgdfghgghghint}
\vec h_1-\left(\vec n\cdot \vec h_1\right)\vec n=\vec h_0-\left(\vec
n\cdot \vec h_0\right)\vec n\quad\quad\forall\vec x\in\mathcal{T},
\end{equation}
and in particular, for every point on the surface $\mathcal{T}$
vector $\vec h_1$ lies in the plane formed by vectors $\vec n$ and
$\vec h_0$.
Next, assume that the approximate equations in
\er{MaxVacFullPPNmmmffffffiuiuhjuughbghhiuijghghghhjhjhhghyuyiyyujjljkint}
and
\er{MaxVacFullPPNmmmffffffiuiuhjuughbghhiuijghghghhhfhhghghguygtjuuujjjkint}
are valid in every of two regions on the both sides of
$\mathcal{T}$. Then by
\er{MaxVacFullPPNmmmffffffiuiuhjuughbghhiuijghghghhjhjhhghyuyiyyujjljkgghhgint}
we have
\begin{equation}\label{MaxVacFullPPNmmmffffffiuiuhjuughbghhiuijghghghhjhjhhghyuyiyyujjljkgghhguyuyint}
|\vec h_1|=|\vec h_0|=\frac{c}{c_0}.
\end{equation}
Then, since $\vec h_1\neq\vec h_0$, by
\er{MaxMedFullGGffgggyyojjyugggjhhjiiuuiyuyuyyuyuzzhhkkkgdfghgghghint}
and
\er{MaxVacFullPPNmmmffffffiuiuhjuughbghhiuijghghghhjhjhhghyuyiyyujjljkgghhguyuyint}
we deduce
\begin{equation}\label{MaxMedFullGGffgggyyojjyugggjhhjiiuuiyuyuyyuyuzzhhkkkgdfghgghghijhjhhkhint}
\vec n\cdot \vec h_1=-\vec n\cdot \vec h_0\quad\quad\forall\vec
x\in\mathcal{T}.
\end{equation}
So, by
\er{MaxVacFullPPNmmmffffffiuiuhjuughbghhiuijghghghhjhjhhghyuyiyyujjljkgghhguyuyint}
and
\er{MaxMedFullGGffgggyyojjyugggjhhjiiuuiyuyuyyuyuzzhhkkkgdfghgghghijhjhhkhint}
we obtain the law of reflection: vector $\vec h_1$ lies in the plane
formed by vectors $\vec n$ and $\vec h_0$, and we have:
\begin{equation}\label{MaxMedFulljhhjjjjjint}
\theta\left(\vec h_0,-\vec n\right)=\theta_1\left(\vec h_1,\vec
n\right)
\end{equation}
where $\theta\left(\vec h_0,-\vec n\right)$ is the angle between the
incoming ray direction $\vec h_0$ and the incoming normal to the
surface $-\vec n$ and $\theta_1\left(\vec h_1,\vec n\right)$ is the
angle between the reflected ray direction $\vec h_1$ and the
outcoming normal $\vec n$.

Next assume that the wave we consider in
\er{MaxVacFullPPNmmmffffffiuiuhjuughbghhuiiujjhhjjhjhhjint} has an
electromagnetic nature. Then by \er{gughhghfbvnbvyyuuyrint} we have
\begin{equation}\label{gughhghfbvnbvyyuuyint}
c_0=c\sqrt{\kappa_0\gamma_0}\quad\text{and}\quad\vec {\tilde
u}=\left(\gamma_0\kappa_0\vec v+(1-\gamma_0\kappa_0)\vec u\right),
\end{equation}
where, $\vec u$ is the medium velocity and $\vec v$ is the local
vectorial gravitational potential. Similarly, on the second side of
surface $\mathcal{T}$ we have
\begin{equation}\label{gughhghfbvnbvyyuuyGGHHGint}
c^{(2)}_0=c\sqrt{\kappa^{(2)}_0\gamma^{(2)}_0}\quad\text{and}\quad\vec
{\tilde u}_2=\left(\gamma^{(2)}_0\kappa^{(2)}_0\vec
v+(1-\gamma^{(2)}_0\kappa^{(2)}_0)\vec u^{(2)}\right),
\end{equation}
where, $\vec u^{(2)}$ is the medium velocity on the second side of
surface $\mathcal{T}$. Furthermore, assume that the medium rests on
the both sides of surface $\mathcal{T}$. Then in this particular
case we rewrite \er{gughhghfbvnbvyyuuyint} and
\er{gughhghfbvnbvyyuuyGGHHGint} as
\begin{equation}\label{gughhghfbvnbvyyuuygggint}
c_0=c\sqrt{\kappa_0\gamma_0}\quad\text{and}\quad\vec {\tilde
u}=\gamma_0\kappa_0\vec v,
\end{equation}
and
\begin{equation}\label{gughhghfbvnbvyyuuyGGHHGhjhjhint}
c^{(2)}_0=c\sqrt{\kappa^{(2)}_0\gamma^{(2)}_0}\quad\text{and}\quad\vec
{\tilde u}_2=\gamma^{(2)}_0\kappa^{(2)}_0\vec v,
\end{equation}
Then in particular, by \er{gughhghfbvnbvyyuuygggint} and
\er{gughhghfbvnbvyyuuyGGHHGhjhjhint} we deduce
\begin{equation}\label{gughhghfbvnbvyyuuyGGHHGhjhjhhhjhjint}
\frac{c}{\left(c^{(2)}_0\right)^2}\vec {\tilde
u}_2=\frac{c}{c^2_0}\vec {\tilde u}=\frac{1}{c}\,\vec v.
\end{equation}
Thus, by inserting
\er{MaxVacFullPPNmmmffffffiuiuhjuughbghhiuijghghghhhfhhghghguygtjuuujjjkyuuykjkjjint}
and \er{gughhghfbvnbvyyuuyGGHHGhjhjhhhjhjint} into
\er{MaxMedFullGGffgggyyojjyugggjhhjiiuuiyuyuyyuyuzzhhkkkgdfgint}, we
deduce:
\begin{equation}\label{MaxMedFullGGffgggyyojjyugggjhhjiiuuiyuyuyyuyuzzhhkkkgdfg1int}
\vec h_2-\left(\vec n\cdot \vec h_2\right)\vec n=\vec h_0-\left(\vec
n\cdot \vec h_0\right)\vec n\quad\quad\forall\vec x\in\mathcal{T},
\end{equation}
and in particular, for every point on the surface $\mathcal{T}$
vector $\vec h_2$ lies in the plane formed by vectors $\vec n$ and
$\vec h_0$. On the other hand by
\er{MaxVacFullPPNmmmffffffiuiuhjuughbghhiuijghghghhjhjhhghyuyiyyujjljkgghhgint}
we have:
\begin{equation}\label{MaxVacFullPPNmmmffffffiuiuhjuughbghhiuijghghghhjhjhhghyuyiyyujjljkgghhglkjlkjkjljljhhjjint}
|\vec h_0|=\frac{c}{c_0}\quad\text{and}\quad|\vec
h_2|=\frac{c}{c^{(2)}_0}.
\end{equation}
So, by
\er{MaxMedFullGGffgggyyojjyugggjhhjiiuuiyuyuyyuyuzzhhkkkgdfg1int}
and
\er{MaxVacFullPPNmmmffffffiuiuhjuughbghhiuijghghghhjhjhhghyuyiyyujjljkgghhglkjlkjkjljljhhjjint},
we have the Snell's law of refraction: vector $\vec h_2$ lies in the
plane formed by vectors $\vec n$ and $\vec h_0$, and we have:
\begin{equation}\label{MaxMedFulljhhjjjjjhhhjint}
n\sin{\left(\theta\left(\vec h_0,\vec
n\right)\right)}=n_2\sin{\left(\theta_2\left(\vec h_2,\vec
n\right)\right)}
\end{equation}
where $\theta\left(\vec h_0,\vec n\right)$ is the angle between the
incoming ray direction $\vec h_0$ and the normal to the surface
$\vec n$, $\theta_2\left(\vec h_2,\vec n\right)$ is the angle
between the refracted ray direction $\vec h_2$ and the normal $\vec
n$ and as in
\er{MaxMedFullGGffgggyyojjhhjkhjyyiuhggjhhjhuyytytyuuytrrtghjtyuggyuighjuyioyyfgffhyuhhghzzrrkkhhkkkhhhjhkjhhghhgghiuiu1int}
we set refraction indexes:
\begin{equation}\label{MaxMedFullGGffgggyyojjhhjkhjyyiuhggjhhjhuyytytyuuytrrtghjtyuggyuighjuyioyyfgffhyuhhghzzrrkkhhkkkhhhjhkjhhghhgghiuiujjkjkint}
n:=\frac{c}{c_0}\quad\text{and}\quad n_2:=\frac{c}{c^{(2)}_0}.
\end{equation}

\subsubsection{Sagnac effect}
Assume again the monochromatic electromagnetic wave of the frequency
$\nu=\omega/(2\pi)$ characterized by:
\begin{equation}\label{MaxVacFullPPNmmmffffffiuiuhjuughbghhuiiujjhhjjhjhhjhjjhhjhjjhint}
U(\vec x,t)=A(\vec x,t)e^{iT(\vec x,t)}=A(\vec x,t)e^{ik_0S(\vec
x,t)},\quad\text{where}\quad
k_0=\frac{\omega}{c}\quad\text{and}\quad\frac{\partial S}{\partial
t}=c\,.
\end{equation}
Then by \er{gughhghfbvnbvyyuuyrint} we have
\begin{equation}\label{gughhghfbvnbvyyuuyrhjhjhjint}
c_0=c\sqrt{\kappa_0\gamma_0}\quad\text{and}\quad\vec {\tilde
u}=\left(\gamma_0\kappa_0\vec v+(1-\gamma_0\kappa_0)\vec u\right),
\end{equation}
where, $\vec u$ is the medium velocity and $\vec v$ is the local
vectorial gravitational potential. Moreover, assume again that we
consider light traveling in some region either filled with the
resting medium of constant dielectric permeability $\gamma_0$ and
magnetic permeability $\kappa_0$ or in the vacuum. Then by
\er{gughhghfbvnbvyyuuyrhjhjhjint} and
\er{MaxMedFullGGffgggyyojjhhjkhjyyiuhggjhhjhuyytytyuuytrrtghjtyuggyuighjuyioyyfgffhyuhhghzzrrkkhhkkkhhhjhkjhhghhgghiuiu1int}
we have
\begin{equation}\label{gughhghfbvnbvyyuuyrhiyyuiuuhjhint}
n=\frac{1}{\sqrt{\kappa_0\gamma_0}}\;\;\;\text{is a
constant,}\quad\text{and}\quad\vec {\tilde u}=\gamma_0\kappa_0\vec
v.
\end{equation}
Next, assume that the light travels from point $N$ to point $M$
across the curve $\vec r(s):[a,b]\to\mathbb{R}^3$ with endpoints
$\vec r(a)=N$ and $\vec r(b)=M$ undergoing possibly certain number
of reflections from mirrors during its travel. Then by
\er{MaxMedFullGGffgggyyojjhhjkhjyyiuhggjhhjhuyytytyuuytrrtghjtyuggyuighjuyioyyfgffhyuhhghzzrrkkhhkkkint},
\er{gughhghfbvnbvyyuuyrhiyyuiuuhjhint} and
\er{MaxMedFullGGffgggyyojjyugggjhhjiiuuiyuyuyyuyuzzhhkkkint} we
have:
\begin{equation}\label{MaxMedFullGGffgggyyojjhhjkhjyyiuhggjhhjhuyytytyuuytrrtghjtyuggyuighjuyioyyfgffhyuhhghzzrrkkhhkkkhjjhint}
\delta(-S):=\left(-S(M^-)\right)-
\left(-S(N^+)\right)=\frac{1}{\sqrt{\kappa_0\gamma_0}}\int_a^b
\left|\vec r'(s)\right|ds-\frac{1}{c}\int_a^b\vec v\left(\vec
r(s)\right)\cdot\vec r'(s)ds.
\end{equation}
In particular, if we assume that $M=N$ i.e. our curve is closed and
moreover, our curve is the boundary of some surface $\mathcal{S}_0$,
then by Stokes Theorem we have:
\begin{multline}\label{MaxMedFullGGffgggyyojjhhjkhjyyiuhggjhhjhuyytytyuuytrrtghjtyuggyuighjuyioyyfgffhyuhhghzzrrkkhhkkkhjjhjjint}
\delta(-S)=\left(-S(M^-)\right)-
\left(-S(M^+)\right)=\frac{1}{\sqrt{\kappa_0\gamma_0}}\int_a^b
\left|\vec r'(s)\right|ds-\frac{1}{c}\iint \left(curl_{\vec x}\vec
v\right)\cdot\vec n
\,d\mathcal{S}_0\\=\frac{1}{\sqrt{\kappa_0\gamma_0}}\left|\partial
\mathcal{S}_0\right|-\frac{1}{c}\iint \left(curl_{\vec x}\vec
v\right)\cdot\vec n \,d\mathcal{S}_0,
\end{multline}
where $\vec n$ is the unit normal to the surface.
In particular, if our coordinate system is inertial, or more
generally non-rotating, then $curl_{\vec x}\vec v=0$ and by
\er{MaxMedFullGGffgggyyojjhhjkhjyyiuhggjhhjhuyytytyuuytrrtghjtyuggyuighjuyioyyfgffhyuhhghzzrrkkhhkkkhjjhjjint}
we deduce
\begin{equation}\label{MaxMedFullGGffgggyyojjhhjkhjyyiuhggjhhjhuyytytyuuytrrtghjtyuggyuighjuyioyyfgffhyuhhghzzrrkkhhkkkhjjhjjhjhhjjhjjhint}
\delta(-S)=\frac{1}{\sqrt{\kappa_0\gamma_0}}\left|\partial
\mathcal{S}_0\right|.
\end{equation}
On the other hand, if our system is rotating, then as in
\er{MaxMedFullGGffgggyyojjhhjkhjyyiuhggjhhjhuyytytyuuytrrtghjtyuggyuighjuyioyyfgffhyuhhghzzrrkkhhkkkhhhjhkjhhghhgghiuiuhhjhjiuiuyujjkint}
we clearly deduce:
\begin{equation}\label{MaxMedFullGGffgggyyojjhhjkhjyyiuhggjhhjhuyytytyuuytrrtghjtyuggyuighjuyioyyfgffhyuhhghzzrrkkhhkkkhhhjhkjhhghhgghiuiuhhjhjiuiuyujjkjjkint}
curl_{\vec x}\vec v=-2\vec w,
\end{equation}
where $\vec w$ is the vector of the angular speed of rotation of our
coordinate system. Then by
\er{MaxMedFullGGffgggyyojjhhjkhjyyiuhggjhhjhuyytytyuuytrrtghjtyuggyuighjuyioyyfgffhyuhhghzzrrkkhhkkkhhhjhkjhhghhgghiuiuhhjhjiuiuyujjkjjkint}
and
\er{MaxMedFullGGffgggyyojjhhjkhjyyiuhggjhhjhuyytytyuuytrrtghjtyuggyuighjuyioyyfgffhyuhhghzzrrkkhhkkkhjjhjjint}
we deduce
\begin{equation}\label{MaxMedFullGGffgggyyojjhhjkhjyyiuhggjhhjhuyytytyuuytrrtghjtyuggyuighjuyioyyfgffhyuhhghzzrrkkhhkkkhjjhjjjjkint}
\delta(-S)=\frac{1}{\sqrt{\kappa_0\gamma_0}}\left|\partial
\mathcal{S}_0\right|+\frac{2}{c}\iint \vec w\cdot\vec n
\,d\mathcal{S}_0.
\end{equation}
In particular, if the surface $\mathcal{S}_0$ is a part of some
plain then we rewrite
\er{MaxMedFullGGffgggyyojjhhjkhjyyiuhggjhhjhuyytytyuuytrrtghjtyuggyuighjuyioyyfgffhyuhhghzzrrkkhhkkkhjjhjjjjkint}
as
\begin{equation}\label{MaxMedFullGGffgggyyojjhhjkhjyyiuhggjhhjhuyytytyuuytrrtghjtyuggyuighjuyioyyfgffhyuhhghzzrrkkhhkkkhjjhjjjjkjjkjljklint}
\delta(-S)=\frac{1}{\sqrt{\kappa_0\gamma_0}}\left|\partial
\mathcal{S}_0\right|+\frac{2}{c} \left(\vec w\cdot\vec
n\right)\left|\mathcal{S}_0\right|.
\end{equation}
On the other hand, if the light travels across the same curve in the
opposite direction, then we must have:
\begin{equation}\label{MaxMedFullGGffgggyyojjhhjkhjyyiuhggjhhjhuyytytyuuytrrtghjtyuggyuighjuyioyyfgffhyuhhghzzrrkkhhkkkhjjhjjjjkjjkjljklpooint}
\delta(-S^-)=\frac{1}{\sqrt{\kappa_0\gamma_0}}\left|\partial
\mathcal{S}_0\right|-\frac{2}{c} \left(\vec w\cdot\vec
n\right)\left|\mathcal{S}_0\right|.
\end{equation}
Thus, by taking the difference in two cases and using
\er{MaxVacFullPPNmmmffffffiuiuhjuughbghhuiiujjhhjjhjhhjhjjhhjhjjhint},
we deduce:
\begin{equation}\label{MaxMedFullGGffgggyyojjhhjkhjyyiuhggjhhjhuyytytyuuytrrtghjtyuggyuighjuyioyyfgffhyuhhghzzrrkkhhkkkhjjhjjjjkjjkjljklkkkhjhjint}
\left(\delta(-T)-\delta(-T^-)\right)=k_0\left(\delta(-S)-\delta(-S^-)\right)=\frac{4\omega}{c^2}\,.
\left(\vec w\cdot\vec n\right)\left|\mathcal{S}_0\right|.
\end{equation}
Here, $\gamma_0$ and $\kappa_0$ are the dielectric and the magnetic
permeability of the medium, $T$ is given in
\er{MaxVacFullPPNmmmffffffiuiuhjuughbghhuiiujjhhjjhjhhjhjjhhjhjjhint},
$\left|\mathcal{S}_0\right|$ is the area of the flat surface bounded
by the closed path of the light, $\vec n$ is the unit normal to the
surface, $\omega$ is the frequency of the light and $\vec w$ is the
angular speed vector of the rotation of our coordinate system.

\subsubsection{Fizeau experiment}\label{seGOfzint}
Assume again the monochromatic electromagnetic wave of the frequency
$\nu=\omega/(2\pi)$ characterized by:
\begin{equation}\label{MaxVacFullPPNmmmffffffiuiuhjuughbghhuiiujjhhjjhjhhjhjjhhjhjjhfzint}
U(\vec x,t)=A(\vec x,t)e^{iT(\vec x,t)}=A(\vec x,t)e^{ik_0S(\vec
x,t)},\quad\text{where}\quad
k_0=\frac{\omega}{c}\quad\text{and}\quad\frac{\partial S}{\partial
t}=c\,.
\end{equation}
Then by \er{gughhghfbvnbvyyuuyrint} we have
\begin{equation}\label{gughhghfbvnbvyyuuyrhjhjhjfzint}
c_0=c\sqrt{\kappa_0\gamma_0}\quad\text{and}\quad\vec {\tilde
u}=\left(\gamma_0\kappa_0\vec v+(1-\gamma_0\kappa_0)\vec u\right),
\end{equation}
where, $\vec u$ is the medium velocity and $\vec v$ is the local
vectorial gravitational potential. Moreover, assume that we consider
light traveling in some region filled with the moving medium of
constant dielectric permeability $\gamma_0$ and magnetic
permeability $\kappa_0$. Then by \er{gughhghfbvnbvyyuuyrhjhjhjfzint}
and
\er{MaxMedFullGGffgggyyojjhhjkhjyyiuhggjhhjhuyytytyuuytrrtghjtyuggyuighjuyioyyfgffhyuhhghzzrrkkhhkkkhhhjhkjhhghhgghiuiu1int}
we have
\begin{equation}\label{gughhghfbvnbvyyuuyrhiyyuiuuhjhfzint}
n=\frac{c}{c_0}=\frac{1}{\sqrt{\kappa_0\gamma_0}}\;\;\;\text{is a
constant,}\quad\text{and}\quad\vec {\tilde u}=\frac{1}{n^2}\,\vec
v+\left(1-\frac{1}{n^2}\right)\vec u.
\end{equation}
Next, assume that the light travels from point $N$ to point $M$
across the curve $\vec r(s):[a,b]\to\mathbb{R}^3$ with endpoints
$\vec r(a)=N$ and $\vec r(b)=M$ undergoing possibly certain number
of reflections from mirrors during its travel. Then, as before, by
\er{MaxMedFullGGffgggyyojjhhjkhjyyiuhggjhhjhuyytytyuuytrrtghjtyuggyuighjuyioyyfgffhyuhhghzzrrkkhhkkkint},
\er{gughhghfbvnbvyyuuyrhiyyuiuuhjhfzint} and
\er{MaxMedFullGGffgggyyojjyugggjhhjiiuuiyuyuyyuyuzzhhkkkint} we
have:
\begin{multline}\label{MaxMedFullGGffgggyyojjhhjkhjyyiuhggjhhjhuyytytyuuytrrtghjtyuggyuighjuyioyyfgffhyuhhghzzrrkkhhkkkhjjhfzint}
\delta(-S):=\left(-S(M^-)\right)- \left(-S(N^+)\right)=\\
n\int_a^b \left|\vec r'(s)\right|ds-\frac{1}{c}\int_a^b\vec
v\left(\vec r(s)\right)\cdot\vec
r'(s)ds-\frac{n^2}{c}\left(1-\frac{1}{n^2}\right)\int_a^b\vec
u\left(\vec r(s)\right)\cdot\vec r'(s)ds.
\end{multline}
Next assume that, either our curve is perpendicular to the direction
of the vectorial gravitational potential $\vec v$, that happens, for
example, if our path of the light is tangent to the Earth surface,
or assume that our curve is closed, i.e. $M=N$ and moreover, our
coordinate system is inertial, or more generally non-rotating. In
particular, if we assume that $M=N$ i.e. our curve is closed and
moreover, our coordinate system is inertial, or more generally
non-rotating, then, as before, by Stokes Theorem we have:
\begin{equation}\label{MaxMedFullGGffgggyyojjhhjkhjyyiuhggjhhjhuyytytyuuytrrtghjtyuggyuighjuyioyyfgffhyuhhghzzrrkkhhkkkhjjhjjfzint}
\int_a^b\vec v\left(\vec r(s)\right)\cdot\vec r'(s)ds=0.
\end{equation}
On the other hand in the case that our curve is perpendicular to the
direction of the vectorial gravitational potential $\vec v$,
\er{MaxMedFullGGffgggyyojjhhjkhjyyiuhggjhhjhuyytytyuuytrrtghjtyuggyuighjuyioyyfgffhyuhhghzzrrkkhhkkkhjjhjjfzint}
also trivially follows. Therefore, by inserting
\er{MaxMedFullGGffgggyyojjhhjkhjyyiuhggjhhjhuyytytyuuytrrtghjtyuggyuighjuyioyyfgffhyuhhghzzrrkkhhkkkhjjhjjfzint}
into
\er{MaxMedFullGGffgggyyojjhhjkhjyyiuhggjhhjhuyytytyuuytrrtghjtyuggyuighjuyioyyfgffhyuhhghzzrrkkhhkkkhjjhfzint}
in both cases we obtain:
\begin{multline}\label{MaxMedFullGGffgggyyojjhhjkhjyyiuhggjhhjhuyytytyuuytrrtghjtyuggyuighjuyioyyfgffhyuhhghzzrrkkhhkkkhjjhfzfzfzint}
\delta(-S)=\left(-S(M^-)\right)- \left(-S(N^+)\right)= n\int_a^b
\left|\vec
r'(s)\right|ds-\frac{n^2}{c}\left(1-\frac{1}{n^2}\right)\int_a^b\vec
u\left(\vec r(s)\right)\cdot\vec r'(s)ds.
\end{multline}
Then by
\er{MaxMedFullGGffgggyyojjhhjkhjyyiuhggjhhjhuyytytyuuytrrtghjtyuggyuighjuyioyyfgffhyuhhghzzrrkkhhkkkhjjhfzfzfzint}
and
\er{MaxVacFullPPNmmmffffffiuiuhjuughbghhuiiujjhhjjhjhhjhjjhhjhjjhfzint}
we deduce
\begin{multline}\label{MaxMedFullGGffgggyyojjhhjkhjyyiuhggjhhjhuyytytyuuytrrtghjtyuggyuighjuyioyyfgffhyuhhghzzrrkkhhkkkhjjhfzfzfzfzffzint}
\delta(-T):=\left(-T(M^-)\right)-
\left(-T(N^+)\right)=k_0\delta(-S)\\= \frac{n\omega}{c}\int_a^b
\left|\vec
r'(s)\right|ds-\frac{n^2\omega}{c^2}\left(1-\frac{1}{n^2}\right)\int_a^b\vec
u\left(\vec r(s)\right)\cdot\vec r'(s)ds\\
\frac{n^2\omega}{c^2}\left(c_0\int_a^b \left|\vec
r'(s)\right|ds-\left(1-\frac{1}{n^2}\right)\int_a^b\vec u\left(\vec
r(s)\right)\cdot\vec r'(s)ds\right).
\end{multline}
In particular, if the absolute value $\left|\vec u\left(\vec
r(s)\right)\right|$ is a constant across the curve and if the angle
between $r'(s)$ and $\vec u\left(\vec r(s)\right)$ is a constant
across the curve and equals to the value $\theta$ then denoting the
length of the path by $L$:
\begin{equation}\label{MaxMedFullGGffgggyyojjhhjkhjyyiuhggjhhjhuyytytyuuytrrtghjtyuggyuighjuyioyyfgffhyuhhghzzrrkkhhkkkhjjhjjfzuyyint}
L:=\int_a^b \left|\vec r'(s)\right|ds,
\end{equation}
by
\er{MaxMedFullGGffgggyyojjhhjkhjyyiuhggjhhjhuyytytyuuytrrtghjtyuggyuighjuyioyyfgffhyuhhghzzrrkkhhkkkhjjhfzfzfzfzffzint}
we deduce:
\begin{equation}\label{MaxMedFullGGffgggyyojjhhjkhjyyiuhggjhhjhuyytytyuuytrrtghjtyuggyuighjuyioyyfgffhyuhhghzzrrkkhhkkkhjjhfzfzfzfzffzghghgfzint}
\delta(-T)=k_0\delta(-S)=\frac{\omega L
n^2}{c^2}\left(c_0-\left(1-\frac{1}{n^2}\right)|\vec
u|\cos{(\theta)}\right).
\end{equation}
Thus, if the direction of $\vec u$ coincides with the direction of
the light i.e. $\theta=0$ then
\begin{equation}\label{MaxMedFullGGffgggyyojjhhjkhjyyiuhggjhhjhuyytytyuuytrrtghjtyuggyuighjuyioyyfgffhyuhhghzzrrkkhhkkkhjjhfzfzfzfzffzghghgfzytyfzint}
\delta(-T)=k_0\delta(-S)=\frac{\omega L
n^2}{c^2}\left(c_0-\left(1-\frac{1}{n^2}\right)|\vec
u|\right)\approx\frac{\omega
L}{\left(c_0+\left(1-\frac{1}{n^2}\right)|\vec u|\right)}.
\end{equation}
On the other hand, if the direction of $\vec u$ is opposite to the
direction of the light i.e. $\theta=\pi$ then
\begin{equation}\label{MaxMedFullGGffgggyyojjhhjkhjyyiuhggjhhjhuyytytyuuytrrtghjtyuggyuighjuyioyyfgffhyuhhghzzrrkkhhkkkhjjhfzfzfzfzffzghghgfzuyuyhffzint}
\delta(-T)=k_0\delta(-S)=\frac{\omega L
n^2}{c^2}\left(c_0+\left(1-\frac{1}{n^2}\right)|\vec
u|\right)\approx\frac{\omega
L}{\left(c_0-\left(1-\frac{1}{n^2}\right)|\vec u|\right)}.
\end{equation}
So, in the frames of our model we explain the results of the Fizeau
experiment.


\subsubsection{The case of the non-monochromatic wave or/and  moving domains of propagation of light}\label{mnGObnbnint}
Next, assume that the wave is not monochromatic or/and the fields
$\vec {\tilde u}$ and $c_0$ depend on the time variable or/and we
consider the case of moving domains of propagation of light (in
particular moving surfaces of reflection/refraction). In other words
we can not assume
\er{MaxVacFullPPNmmmffffffiuiuhjuughbghhiuijghghghhjhjhhghyuyhhjhjihikjjjjkjljklint}
or
\er{MaxVacFullPPNmmmffffffiuiuhjuughbghhiuijghghghhjhjhhghyuyhhjhjihikjjjint}
anymore. However we do the Geometric Optics approximation
\er{slochangGGaaffgfgjhjgghint}. We also assume that our medium has
no dispersion. Then, due to
\er{MaxVacFullPPNmmmffffffiuiuhjuughbghhuiiujjint} we have:
\begin{equation}\label{MaxVacFullPPNmmmffffffiuiuhjuughbghhuiiujjhkhkhint}
U(\vec x,t)=A(\vec x,t)e^{ik_0S(\vec x,t)},
\end{equation}
and by
\er{MaxVacFullPPNmmmffffffiuiuhjuughbghhuiiujjjjjjjjhhhjjjint} and
\er{MaxVacFullPPNmmmffffffiuiuhjuughbghhuiiujjjjjjjjhhhjjjkkint} we
have:
\begin{equation}\label{MaxVacFullPPNmmmffffffiuiuhjuughbghhuiiujjjjjjjjhhhjjjjkkjjkihhhjint}
\left<\,\left|\frac{\partial S}{\partial t}\right|\,\right>=c,
\end{equation}
where the sign $\left<\cdot\right>$ means the spatial and temporal
averaging. Then, due to
\er{MaxVacFullPPNmmmffffffiuiuhjuughbghhiuijghghghhjhjhhghyuyjjjhhjhjffint}
we have the Eikonal type equation:
\begin{equation}\label{MaxVacFullPPNmmmffffffiuiuhjuughbghhiuijghghghhjhjhhghyuyjjjhhjhjffkklint}
\frac{1}{c^2_0}\left(\frac{\partial S}{\partial t}+\vec {\tilde
u}\cdot\nabla_\vec x S\right)^2=\left|\nabla_\vec x S\right|^2,
\end{equation}
and we rewrite the equation of the propagation of the beam
\er{MaxVacFullPPNmmmffffffiuiuhjuughbghhiuijghghghhhfhhghghguygtjjjkkjjkkhhint}
as:
\begin{equation}\label{MaxVacFullPPNmmmffffffiuiuhjuughbghhiuijghghghhhfhhghghguygtjjjkkjjkkhhklklkjkint}
\frac{\partial A}{\partial t}(\vec x,t)
+\vec h(\vec x,t)\cdot\nabla_{\vec x}A(\vec x,t)=
G(\vec x,t) A(\vec x,t)
,
\end{equation}
where we denote
\begin{equation}\label{MaxVacFullPPNmmmffffffiuiuhjuughbghhiuijghghghhhgjgfghjdfjgddg11int}
\vec h(\vec x,t):=\vec {\tilde u}(\vec x,t)-c^2_0(\vec
x,t)\left(\frac{\partial S}{\partial t}(\vec x,t)+\vec {\tilde
u}(\vec x,t)\cdot\nabla_{\vec x}S(\vec x,t)\right)^{-1}\nabla_{\vec
x}S(\vec x,t),
\end{equation}
and
\begin{multline}\label{MaxVacFullPPNmmmffffffiuiuhjuughbghhiuijghghghhhfhhghghguygtjuuujjjkhjhjh11kklkloiouuint}
G\left(\vec x,t\right):=\\
\left(\frac{c^2_0}{2\left(\frac{\partial S}{\partial t}+\vec {\tilde
u}\cdot\nabla_{\vec x}S\right)}\left(\Delta_{\vec
x}S-\frac{\partial}{\partial
t}\left\{\frac{1}{c^2_0}\left(\frac{\partial S}{\partial t}+\vec
{\tilde u}\cdot\nabla_{\vec x}S\right)\right\}-\Div_{\vec x}
\left\{\frac{1}{c^2_0}\left(\frac{\partial S}{\partial t}+\vec
{\tilde u}\cdot\nabla_{\vec x} S\right)\vec {\tilde
u}\right\}\right)\right).
\end{multline}
Next, consider a curve $\vec r(t):[t_0,b]\to\mathbb{R}^3$
parameterized by the time variable $t$, defined as a solution of
ordinary differential equation:
\begin{equation}\label{MaxVacFullPPNmmmffffffiuiuhjuughbghhiuijghghghhhfhhghghguygtjuuujjjkk11int}
\begin{cases}
\frac{d\vec r}{dt}(t)=\vec h\left(\vec r(t),t\right)\\
\vec r(t_0)=\vec x_0,
\end{cases}
\end{equation}
where $\vec h$ was defined in
\er{MaxVacFullPPNmmmffffffiuiuhjuughbghhiuijghghghhhgjgfghjdfjgddg11int}.
Note that the equations of the ray propagation of the form
\er{MaxVacFullPPNmmmffffffiuiuhjuughbghhiuijghghghhhfhhghghguygtjuuujjjkk11int}
are not always convenient since $\vec h$ in the right hand side of
\er{MaxVacFullPPNmmmffffffiuiuhjuughbghhiuijghghghhhfhhghghguygtjuuujjjkk11int}
depend on the partial derivatives of the quantity $S$, which we do
not know apriory unless we solve
\er{MaxVacFullPPNmmmffffffiuiuhjuughbghhiuijghghghhjhjhhghyuyjjjhhjhjffkklint}.
In the following proposition, that we prove in subsection
\ref{mnGObnbn}, we present the alternative form for the equations of
the ray propagation being the second order ordinary differential
equations which dose not contain the quantity $S$ or its partial
derivatives.
\begin{proposition}\label{profghgjjhint}
Consider a smooth solution of
\er{MaxVacFullPPNmmmffffffiuiuhjuughbghhiuijghghghhjhjhhghyuyjjjhhjhjffkklint}.
Next let $\vec r(t):[t_0,b]\to\mathbb{R}^3$ be any curve
parameterized by the time variable $t$, defined as a solution
\er{MaxVacFullPPNmmmffffffiuiuhjuughbghhiuijghghghhhfhhghghguygtjuuujjjkk11int}.
Then, $\vec r(t)$ satisfies the following second order ordinary
differential equations of the Ray Propagation:
\begin{multline}\label{MaxVacFullPPNmmmffffffiuiuhjuughbghhiuijghghghhhfhhghghguygtjuuujjjkk11jjjggjjgtttjhhuhhjhjhgghghjhjggjugghghint}
\frac{d}{dt}\left(\frac{1}{c_0\left(\vec
r(t),t\right)}\left(\frac{d\vec r}{dt}(t)-\vec {\tilde u}\left(\vec
r(t),t\right)\right)\right)=
 -\nabla_\vec x c_0(\vec r,t)-\frac{1}{c_0(\vec r,t)}\left\{d_{\vec x}\vec {\tilde u}(\vec
r,t)\right\}^T\cdot\left(\frac{d\vec r}{dt}-\vec {\tilde
u}(\vec r,t)\right)\\
+\frac{1}{c^2_0(\vec r,t)}\left(\left(\frac{d\vec r}{dt}-\vec
{\tilde u}(\vec r,t)\right)\cdot\nabla_{\vec x}c_0(\vec
r,t)\right)\left(\frac{d\vec r}{dt}-\vec {\tilde
u}(\vec r,t)\right)\\
+\frac{1}{2c^3_0(\vec r,t)}\left(\left(\left(\left\{d_{\vec x}\vec
{\tilde u}(\vec r,t)\right\}^T+d_{\vec x}\vec {\tilde u}(\vec
r,t)\right)\cdot\left(\frac{d\vec r}{dt}-\vec {\tilde u}(\vec
r,t)\right)\right)\cdot \left(\frac{d\vec r}{dt}-\vec {\tilde
u}(\vec r,t)\right)\right)\left(\frac{d\vec r}{dt}-\vec {\tilde
u}(\vec r,t)\right).
\end{multline}
Moreover, the proper scalar quantity
$\tilde\omega(t):[t_0,b]\to\mathbb{R}$, defined by
\begin{equation}\label{MaxVacFullPPNmmmffffffiuiuhjuughbghhiuijghghghhhfhhghghguygtjuuujjjkk11jjjggjjgyhhjhhjint}
\tilde\omega(t)=\frac{\partial S}{\partial t}\left(\vec
r(t),t\right)+\vec {\tilde u}\left(\vec
r(t),t\right)\cdot\nabla_{\vec x}S\left(\vec r(t),t\right)
\end{equation}
satisfies the following ordinary differential equation:
\begin{multline}\label{MaxVacFullPPNmmmffffffiuiuhjuughbghhiuijghghghhhfhhghghguygtjuuujjjkk11jjjggjjgkhhhhkjhkhkhhkkkhkkhhhhjggghjhjint}
\frac{d\tilde\omega}{dt}(t)= \frac{1}{c_0(\vec
r,t)}\left(\frac{\partial c_0}{\partial t}(\vec r,t)+\vec {\tilde
u}(\vec r,t)\cdot\nabla_{\vec x}c_0(\vec
r,t)\right)\tilde\omega(t)\\-\frac{1}{2c^2_0(\vec
r,t)}\left(\left(\left(\left\{d_{\vec x}\vec {\tilde u}(\vec
r,t)\right\}^T+d_{\vec x}\vec {\tilde u}(\vec
r,t)\right)\cdot\left(\frac{d\vec r}{dt}(t)-\vec {\tilde u}(\vec
r,t)\right)\right)\cdot \left(\frac{d\vec r}{dt}(t)-\vec {\tilde
u}(\vec r,t)\right)\right)\tilde\omega(t).
\end{multline}
\end{proposition}
Note that, as we deduce in subsection \ref{mnGObnbn}, if $\vec r(t)$
satisfies
\er{MaxVacFullPPNmmmffffffiuiuhjuughbghhiuijghghghhhfhhghghguygtjuuujjjkk11jjjggjjgtttjhhuhhjhjhgghghjhjggjugghghint}
and moreover it satisfies the following equality for the initial
instant of time $t_0$:
\begin{equation*}
\frac{1}{c^2_0\left(\vec r(t_0),t_0\right)}\left|\frac{d\vec
r}{dt}(t_0)-\vec {\tilde u}\left(\vec r(t_0),t_0\right)\right|^2=1,
\end{equation*}
then we have
\begin{equation*}
\frac{1}{c^2_0\left(\vec r(t),t\right)}\left|\frac{d\vec
r}{dt}(t)-\vec {\tilde u}\left(\vec r(t),t\right)\right|^2=1
\end{equation*}
for every instant of time. So we have the following consistency with
\er{MaxVacFullPPNmmmffffffiuiuhjuughbghhiuijghghghhjhjhhghyuyjjjhhjhjffkklkljljjkhhjkjhghghghl;kl;;k;kkljjkjkll;l;int}
and
\er{MaxVacFullPPNmmmffffffiuiuhjuughbghhiuijghghghhhfhhghghguygtjuuujjjkk11int}:
\begin{equation}\label{MaxVacFullPPNmmmffffffiuiuhjuughbghhiuijghghghhhfhhghghguygtjuuujjjkk11jjjggjjgjkljkuhuhikjhjhint}
\left|\frac{d\vec r}{dt}(t)-\vec {\tilde u}(\vec
r,t)\right|^2=c^2_0(\vec r,t).
\end{equation}

Next note that, as before, we can easily deduce that equations of
the ray propagation either of the form
\er{MaxVacFullPPNmmmffffffiuiuhjuughbghhiuijghghghhhfhhghghguygtjuuujjjkk11int}
or of the form
\er{MaxVacFullPPNmmmffffffiuiuhjuughbghhiuijghghghhhfhhghghguygtjuuujjjkk11jjjggjjgtttjhhuhhjhjhgghghjhjggjugghghint}
are invariant under the change of non-inertial cartesian coordinate
system. Moreover,
\er{MaxVacFullPPNmmmffffffiuiuhjuughbghhiuijghghghhhfhhghghguygtjuuujjjkk11jjjggjjgkhhhhkjhkhkhhkkkhkkhhhhjggghjhjint}
is also invariant under the change of non-inertial cartesian
coordinate system.

Next consider a wave characterized by:
\begin{equation}\label{MaxVacFullPPNmmmffffffiuiuhjuughbghhuiiujjhhjjhjhhj11int}
U(\vec x,t)=A(\vec x,t)e^{ik_0S(\vec x,t)},
\end{equation}
and, consistently with
\er{MaxVacFullPPNmmmffffffiuiuhjuughbghhiuijghghghhhgjgfghjdfjgddg11int}
consider a velocity field of the wave:
\begin{equation}\label{MaxVacFullPPNmmmffffffiuiuhjuughbghhiuijghghghhhfhhghghguygtjuuujjjkyuuykjkjj11int}
\vec h(\vec x,t):=\vec {\tilde u}(\vec x,t)-c^2_0(\vec
x,t)\left(\frac{\partial S}{\partial t}(\vec x,t)+\vec {\tilde
u}(\vec x,t)\cdot\nabla_{\vec x}S(\vec x,t)\right)^{-1}\nabla_{\vec
x}S(\vec x,t),
\end{equation}
Furthermore, assume that the wave we consider undergoes reflection
and/or refraction on the time-dependent surface $\mathcal{T}$
having the outcoming three-dimensional unit normal $\vec n(\vec
x,t)$ and the motion velocity field $\vec w_\mathcal{T}(\vec x,t)$,
separating two regions characterized respectively by $c_0=c^{(1)}_0$
and $\vec {\tilde u}=\vec {\tilde u}_1$ and by $c^{(2)}_0$ and $\vec
{\tilde u}_2$, with the formation of the reflected wave,
characterized by:
\begin{equation}\label{MaxVacFullPPNmmmffffffiuiuhjuughbghhuiiujjhhjjhjhhjugh11int}
U_1(\vec x,t)=A_1(\vec x,t)e^{ik_0S_1(\vec x,t)}\,,
\end{equation}
and by the velocity field:
\begin{equation}\label{MaxVacFullPPNmmmffffffiuiuhjuughbghhiuijghghghhhfhhghghguygtjuuujjjkyuuykjkjjhjhhj11int}
\vec h_1(\vec x)=\vec {\tilde u}_1(\vec x,t)-c^2_0(\vec
x,t)\left(\frac{\partial S_1}{\partial t}(\vec x,t)+\vec {\tilde
u}_1(\vec x,t)\cdot\nabla_{\vec x}S_1(\vec
x,t)\right)^{-1}\nabla_{\vec x}S_1(\vec x,t),
\end{equation}
and formation of the refracted wave characterized by:
\begin{equation}\label{MaxVacFullPPNmmmffffffiuiuhjuughbghhuiiujjhhjjhjhhj111int}
U_2(\vec x,t)=A_2(\vec x,t)e^{ik_0S_2(\vec x,t)}\,,
\end{equation}
and by the velocity field:
\begin{equation}\label{MaxVacFullPPNmmmffffffiuiuhjuughbghhiuijghghghhhfhhghghguygtjuuujjjkyuuykjkjj111int}
\vec h_2(\vec x)=\vec {\tilde u}_2(\vec x,t)-(c^{(2)}_0)^2(\vec
x,t)\left(\frac{\partial S_2}{\partial t}(\vec x,t)+\vec {\tilde
u}_2(\vec x,t)\cdot\nabla_{\vec x}S_2(\vec
x,t)\right)^{-1}\nabla_{\vec x}S_2(\vec x,t).
\end{equation}
Then the boundary conditions of $U$, $U_1$ and $U_2$ depend on the
physical meaning of these fields. However, one of the
\underline{necessary} conditions should be that
\begin{equation}\label{MaxMedFullGGffgggyyojjyugggjhhjiiuuiyuyuyyuyuzzhhkkk11int}
S_1(\vec x,t)=S_2(\vec x,t)+C_2=S(\vec x,t)\quad\quad\forall\vec
x\in\mathcal{T},\;\forall t,
\end{equation}
where $C_2$ is a real constant. In particular
\er{MaxMedFullGGffgggyyojjyugggjhhjiiuuiyuyuyyuyuzzhhkkk11int}
implies
\begin{equation}\label{MaxMedFullGGffgggyyojjyugggjhhjiiuuiyuyuyyuyuzzhhkkkgdfg11int}
\nabla_{\vec x} S_1-\left(\vec n\cdot \nabla_{\vec x} S_1\right)\vec
n
=\nabla_{\vec x} S_2-\left(\vec n\cdot \nabla_{\vec x}
S_2\right)\vec n=\nabla_{\vec x} S-\left(\vec n\cdot \nabla_{\vec x}
S\right)\vec n\quad\quad\forall\vec x\in\mathcal{T},\;\forall t.
\end{equation}
In particular, for every point on the surface $\mathcal{T}$ vectors
$\nabla_{\vec x} S_1$ and $\nabla_{\vec x} S_2$ lie in the plane
formed by vectors $\vec n$ and $\nabla_{\vec x} S$. Moreover, by
\er{MaxMedFullGGffgggyyojjyugggjhhjiiuuiyuyuyyuyuzzhhkkk11int} we
also have
\begin{equation}\label{MaxMedFullGGffgggyyojjyugggjhhjiiuuiyuyuyyuyuzzhhkkkgdfghgghgh11kjhk11int}
\frac{\partial S}{\partial t}+\vec w_\mathcal{T}\cdot\nabla_{\vec x}
S=\frac{\partial S_1}{\partial t}+\vec
w_\mathcal{T}\cdot\nabla_{\vec x} S_1=\frac{\partial S_2}{\partial
t}+\vec w_\mathcal{T}\cdot\nabla_{\vec x} S_2\quad\quad\forall\vec
x\in\mathcal{T},\;\forall t.
\end{equation}
Finally, by
\er{MaxVacFullPPNmmmffffffiuiuhjuughbghhiuijghghghhjhjhhghyuyjjjhhjhjffkklkljljjkhhjkjhghghghl;kl;;k;kkljjkjkll;l;int}
we have:
\begin{equation}\label{MaxVacFullPPNmmmffffffiuiuhjuughbghhiuijghghghhjhjhhghyuyjjjhhjhjffkklkljljjkhhjkjhghghghl;kl;;k;kkljjkjkll;l;jijhhjkjkkjint}
\left|\vec h-\vec{\tilde u}\right|^2=c^2_0\,,\quad\left|\vec
h_1-\vec{\tilde u}\right|^2=c^2_0\quad\text{and}\quad\left|\vec
h_2-\vec{\tilde u}_2\right|^2=(c^{(2)}_0)^2.
\end{equation}
In subsection \ref{mnGObnbn} we prove that if the following
approximation is valid on the both sides of the surface
$\mathcal{T}$:
\begin{equation}\label{ojhkkkpkkljouu11int}
\frac{|\vec w_\mathcal{T}|^2}{c^2_0+(c^{(2)}_0)^2}\ll
1\,,\quad\frac{|\vec {\tilde u}|^2}{c^2_0}\ll
1\quad\text{and}\quad\frac{|\vec {\tilde u}_2|^2}{(c^{(2)}_0)^2}\ll
1,
\end{equation}
then, we have the following law of reflection: vector $(\vec
h_1-\vec w_{\mathcal{T}})$ lies in the plane formed by vectors $\vec
n$ and $(\vec h-\vec w_{\mathcal{T}})$, and we have:
\begin{equation}\label{MaxMedFulljhhjjjjj11int}
\theta\left((\vec h-\vec w_{\mathcal{T}}),-\vec
n\right)=\theta_1\left((\vec h_1-\vec w_{\mathcal{T}}),\vec n\right)
\end{equation}
where $\theta\left((\vec h-\vec w_{\mathcal{T}}),-\vec n\right)$ is
the angle between the vector of the relative velocity of the
incoming ray, relative to the surface of reflection, $(\vec h-\vec
w_{\mathcal{T}})$ and the incoming normal to the surface $-\vec n$
and $\theta_1\left((\vec h_1-\vec w_{\mathcal{T}}),\vec n\right)$ is
the angle between the vector of the relative velocity of the
reflected ray, relative to the surface of reflection, $(\vec
h_1-\vec w_{\mathcal{T}})$ and the outcoming normal $\vec n$.

Moreover, if we assume that the wave we consider in
\er{MaxVacFullPPNmmmffffffiuiuhjuughbghhuiiujjhhjjhjhhj11int} has an
electromagnetic nature. Then by \er{gughhghfbvnbvyyuuyrint} we have
\begin{equation}\label{gughhghfbvnbvyyuuy11int}
c_0=c\sqrt{\kappa_0\gamma_0}\quad\text{and}\quad\vec {\tilde
u}=\left(\gamma_0\kappa_0\vec v+(1-\gamma_0\kappa_0)\vec u\right),
\end{equation}
where, $\vec u$ is the medium velocity and $\vec v$ is the local
vectorial gravitational potential. Similarly, on the second side of
surface $\mathcal{T}$ we have
\begin{equation}\label{gughhghfbvnbvyyuuyGGHHG11int}
c^{(2)}_0=c\sqrt{\kappa^{(2)}_0\gamma^{(2)}_0}\quad\text{and}\quad\vec
{\tilde u}_2=\left(\gamma^{(2)}_0\kappa^{(2)}_0\vec
v+(1-\gamma^{(2)}_0\kappa^{(2)}_0)\vec u_2\right),
\end{equation}
where, $\vec u_2$ is the medium velocity on the second side of
surface $\mathcal{T}$. Then in subsection \ref{mnGObnbn} we also
prove that we have the following Snell's law of refraction: $(\vec
h_2-\vec w_{\mathcal{T}})$ lies in the plane formed by vectors $\vec
n$ and $(\vec h-\vec w_{\mathcal{T}})$ and we have:
\begin{equation}\label{MaxMedFulljhhjjjjjhhhj11int}
n\sin{\left(\theta\left((\vec h-\vec w_{\mathcal{T}}),\vec
n\right)\right)}=n_2\sin{\left(\theta_2\left((\vec h_2-\vec
w_{\mathcal{T}}),\vec n\right)\right)}
\end{equation}
where $\theta\left((\vec h-\vec w_{\mathcal{T}}),\vec n\right)$ is
the angle between the vector of the relative velocity of the
incoming ray, relative to the surface of refraction, $(\vec h-\vec
w_{\mathcal{T}})$ and the normal to the surface $\vec n$,
$\theta_2\left((\vec h_2-\vec w_{\mathcal{T}}),\vec n\right)$ is the
angle between the vector of the relative velocity of the refracted
ray, relative to the surface of refraction, $(\vec h_2-\vec
w_{\mathcal{T}})$ and the normal $\vec n$ and as in
\er{MaxMedFullGGffgggyyojjhhjkhjyyiuhggjhhjhuyytytyuuytrrtghjtyuggyuighjuyioyyfgffhyuhhghzzrrkkhhkkkhhhjhkjhhghhgghiuiu1int}
we set refraction indexes:
\begin{equation}\label{MaxMedFullGGffgggyyojjhhjkhjyyiuhggjhhjhuyytytyuuytrrtghjtyuggyuighjuyioyyfgffhyuhhghzzrrkkhhkkkhhhjhkjhhghhgghiuiujjkjk11int}
n:=\frac{c}{c_0}\quad\text{and}\quad n_2:=\frac{c}{c^{(2)}_0}.
\end{equation}
Note that, since $\vec h$ and $\vec w_{\mathcal{T}}$ are both speed
like vector fields then $(\vec h-\vec w_{\mathcal{T}})$ is a
\underline{proper} vector field and thus the above law of reflection
together with \er{MaxMedFulljhhjjjjj11int} and the Snell's law
together with \er{MaxMedFulljhhjjjjjhhhj11int} are invariant under
the change of inertial or non-inertial cartesian coordinate systems.
In particular, if \er{ojhkkkpkkljouu11int} holds for some cartesian
coordinate system, then we can use this laws also in other
coordinate systems where \er{ojhkkkpkkljouu11int} does not hold.
Therefore, for the validity of the above laws of reflection and
refraction we may assume the following relation instead of
\er{ojhkkkpkkljouu11int}:
\begin{equation}\label{ojhkkkpkkljouu11uihkhkjkjkkliyuttytyint}
\frac{|\vec {\tilde u}-\vec w_\mathcal{T}|^2}{c^2_0}\ll
1\quad\text{and}\quad\frac{|\vec {\tilde u}_2-\vec
w_\mathcal{T}|^2}{(c^{(2)}_0)^2}\ll 1.
\end{equation}

\subsubsection{Polarization of the light inside a moving medium and/or in the presence of gravitational
field}\label{jgjghjghhffhfint}
Again consider the system of Maxwell equations in the vacuum or in a
medium of the form \er{MaxMedFullGGffykkhjhhzzkk} in the absence of
macroscopic charges and/or currents:
\begin{equation}\label{MaxVacFullPPNffGGhjhjhjhint}
\begin{cases}
curl_{\vec x} \vec H=
\frac{1}{c}\frac{\partial \vec D}{\partial t},\\
div_{\vec x} \vec D=0
,\\
curl_{\vec x} \vec E+\frac{1}{c}\frac{\partial \vec B}{\partial t}=0,\\
div_{\vec x} \vec B=0,\\
\vec E=\gamma_0\vec D-\frac{1}{c}\,\vec {\tilde u}\times \vec B,\\
\vec H=\kappa_0\vec B+\frac{1}{c}\,\vec {\tilde u}\times \vec D,
\\ \vec {\tilde u}=\left(\gamma_0\kappa_0\vec v+(1-\gamma_0\kappa_0)\vec u\right),
\end{cases}
\end{equation}
and consider the case of monochromatic wave of the constant
frequency $\nu=\frac{\omega}{2\pi}$ where the fields $\vec {\tilde
u}$, $\gamma_0$ and $\kappa_0$ are independent of the time variable.
We also assume that either our medium has no dispersion, i.e.
\er{EBDHTrans444knbuihuigGGint} is valid, or the velocity of the
medium is negligible i.e. $\vec u\equiv 0$ (and so $\vec{\tilde
u}\equiv \vec v$) and our medium is transparent for the given
frequency $\nu=\frac{\omega}{2\pi}$.

Next assume the \underline{rough} Geometric Optics approximation
\er{slochangGGaaffgfgjhjggh11int} (stronger than
\er{slochangGGaaffgfgjhjgghint}) that means the following: assume
that the changes in time of the basic characteristics of the
electromagnetic field become essential after certain interval of
time $T_e$ and the changes in space of of the basic characteristics
of the electromagnetic field become essential in the spatial
landscape $L_e$. Then we assume that
\begin{equation}\label{slochangGGaaffgfgjhjgghhkjhklkllkint}
k_0c_0T_e\,\gg\, 1\quad\text{and}\quad k_0L_e\,\gg\, 1,
\end{equation}
where
\begin{equation}\label{MaxVacFullPPNmmmffffffiuiuhjuughbghhuiiujjhhjjhjhhjyuyhhjjjkjkint}
k_0=\frac{\omega}{c}.
\end{equation}
We also assume \er{ojhkkint}:
\begin{equation}\label{ojhkkkhhhhint}
\frac{|\vec {\tilde u}|^2}{c^2_0}\ll 1.
\end{equation}
Then consistently with
\er{MaxVacFullPPNmmmffffffiuiuhjuughbghhuiiujjint} we can write
\begin{equation}\label{MaxVacFullPPNmmmffffffiuiuhjuughbghhuiiujjnewjikhkjhjkkjhhjkkhhhjint}
\begin{cases}
\vec D(\vec x,t)=\Xi_1\cdot\vec D_a(\vec x)e^{ik_0S(\vec x,t)}
\\
\vec B(\vec x,t)=\Xi_2\cdot\vec B_a(\vec x)e^{ik_0S(\vec x,t)}
\\
\vec E(\vec x,t)=\Xi_3\cdot\vec E_a(\vec x)e^{ik_0S(\vec x,t)}
\\
\vec H(\vec x,t)=\Xi_4\cdot\vec H_a(\vec x)e^{ik_0S(\vec x,t)}
\\
\vec E=\gamma_0\vec D-\frac{1}{c}\,\vec {\tilde u}\times \vec B\\
\vec H=\kappa_0\vec B+\frac{1}{c}\,\vec {\tilde u}\times \vec D.
\end{cases}
\end{equation}
Here $\vec D_a(\vec x),\vec B_a(\vec x),\vec E_a(\vec x),\vec
H_a(\vec x):\mathbb{R}^3\to\mathbb{R}^3$ are real vector fields,
independent of the time variable, $S(\vec x,t)$ is a real function
such that $\frac{\partial S}{\partial t}=c$ and
$\Xi_1,\Xi_2,\Xi_3,\Xi_4\in \mathbb{C}^{3\times 3}$ are constant
complex diagonal matrices of the form:
\begin{equation}\label{ghgyfftdtdljkljnnewkkkkkint}
\Xi_k=\left(\begin{matrix}e^{i\theta_{1k}}&0&0\\0&e^{i\theta_{2k}}&0\\0&0&e^{i\theta_{3k}}\end{matrix}\right)\quad\quad\quad\quad\forall
k=1,2,3,4,
\end{equation}
where $\theta_{1k},\theta_{2k},\theta_{3k}$ are real constants.
%
%
%
%
%
%
%
Then, consistently with
\er{MaxVacFullPPNmmmffffffiuiuhjuughbghhiuijghghghhjhjhhghyuyiyyujjljkint},
$S$ satisfies the Eikonal equation:
\begin{equation}\label{MaxVacFullPPNmmmffffffiuiuhjuughbghhiuijghghghhhfhhghghguygtjuuujjjknewljjljllkgugghklkllljojjjint}
\left|\frac{c\vec {\tilde u}}{c^2_0}-\nabla_\vec x
S\right|^2=\frac{c^2}{c^2_0},
\end{equation}
and consistently with
\er{MaxVacFullPPNmmmffffffiuiuhjuughbghhiuijghghghhhfhhghghguygtjuuujjjkint}
if $\vec A_1$ denotes one of the vectors $\vec D_a,\vec B_a,\vec
E_a,\vec H_a$ then:
\begin{equation}\label{MaxVacFullPPNmmmffffffiuiuhjuughbghhiuijghghghhhfhhghghguygtjuuujjjknewljjljllkgugghklkllljjkjkkkklint}
\left\{d_{\vec x}\vec A_1\right\}^T\cdot\left(\frac{c\vec {\tilde
u}}{c^2_0}-\nabla_{\vec x}S\right)+\frac{1}{2}\left(-\Delta_{\vec
x}S\right)\vec A_1=0.
\end{equation}
Moreover, consistently with
\er{MaxVacFullPPNmmmffffffiuiuhjuughbghhiuijghghghhjhjhhghyuyhhjhjihikjjjjkjljklint},
\er{MaxVacFullPPNmmmffffffiuiuhjuughbghhiuijghghghhjhjhhghyuyhhjhjihikjjjint}
and \er{gughhghfbvnbvyyuuyrint} we have:
\begin{equation}\label{MaxVacFullPPNmmmffffffiuiuhjuughbghhiuijghghghhjhjhhghyuyhhjhjihikjjjnewint}
\begin{cases}
c_0=c\sqrt{\kappa_0\gamma_0}\\
\frac{\partial S}{\partial t}=c\\
\frac{\partial \vec A_1}{\partial t}=0,\\
\frac{\partial \vec {\tilde u}}{\partial t}=0\\
\frac{\partial c_0}{\partial t}=0,
\end{cases}
\end{equation}
and, consistently with
\er{MaxVacFullPPNmmmffffffiuiuhjuughbghhiuijghghghhhfhhghghguygtjuuujjjkyuuyint},
the vector field $\vec h_0$ defined for every $\vec x$ by:
\begin{equation}\label{MaxVacFullPPNmmmffffffiuiuhjuughbghhiuijghghghhhfhhghghguygtjuuujjjkyuuykkhhkint}
\vec h_0(\vec x):=\frac{c}{c^2_0(\vec x)}\vec {\tilde u}(\vec
x)-\nabla_{\vec x}S(\vec x),
\end{equation}
is the direction of the propagation of the ray that passes through
point $\vec x$.
Then in subsection \ref{jgjghjghhffhf} we deduce that:
\begin{equation}\label{MaxVacFullPPNffGGhjhjhjhjkhkhklklkint}
\begin{cases}
 \vec
B\approx\frac{\gamma_0}{\left(1+\frac{1}{c}\vec {\tilde
u}\cdot\nabla_{\vec x}S\right)}\left(-\nabla_{\vec
x}S\right)\times\vec D
\\
\vec D\approx-\frac{\kappa_0}{\left(1+\frac{1}{c}\vec {\tilde
u}\cdot\nabla_{\vec x}S\right)}\left(-\nabla_{\vec
x}S\right)\times\vec B
\\
\left(-\nabla_{\vec x}S\right)\cdot \vec D\approx 0\\
 \left(-\nabla_{\vec x}S\right)\cdot \vec
B\approx 0.
\end{cases}
\end{equation}
So the vectors $(-\nabla_{\vec x}S)\,$, $\,\vec D$ and $\vec B$ form
together rightly orientated orthogonal system of vectors.
Moreover, in subsection \ref{jgjghjghhffhf} we also deduce that:
\begin{equation}\label{MaxVacFullPPNffGGhjhjhjhjkhkhljjkjnhhhhnhkhkhhjhjhjjhjgjlljkjuiyklkint}
\begin{cases}
\vec H\approx\kappa_0
\vec h_0\times\vec E
\\
\vec E\approx
-\gamma_0\vec h_0\times\vec H\\
\vec h_0\cdot\vec E\approx 0\\
\vec h_0\cdot \vec H\approx 0.
\end{cases}
\end{equation}
So the vectors
$\vec h_0\,$, $\vec E$ and $\vec H$ form together rightly orientated
orthogonal system of vectors. We remind here again that $\vec h_0$
is the direction of the propagation of the ray.

\subsubsection{More delicate approximation of the wave
equation}\label{gjgggggggggggggggggggghkhkint} Again consider the
scalar wave equation of the form
\begin{equation}\label{MaxVacFullPPNmmmffffffiuiuhjuughbghhjjjoojouiuiuint}
\left(\frac{\partial}{\partial
t}\left\{\frac{1}{c^2_0}\left(\frac{\partial U}{\partial t}+\vec
{\tilde u}\cdot\nabla_{\vec x}U\right)\right\}+\Div_{\vec x}
\left\{\frac{1}{c^2_0}\left(\frac{\partial U}{\partial t}+\vec
{\tilde u}\cdot\nabla_{\vec x} U\right)\vec {\tilde
u}\right\}\right)-\Delta_{\vec x}U=0.
\end{equation}
Above, we established $U(\vec x,t)$ only in the \underline{rough}
Geometric Optics approximation \er{slochangGGaaffgfgjhjggh11int}.
Since we have
\er{MaxVacFullPPNmmmffffffiuiuhjuughbghhuiiujjjjjjjjgjjgint} in the
rough Geometric Optics approximation and $S(\vec x,t)$ satisfies the
following Eikonal equation
\begin{equation}\label{MaxVacFullPPNmmmffffffiuiuhjuughbghhiuijghghghhjhjhhghyuyjjjhhjhjffhjjhjojint}
\frac{1}{c^2_0}\left(\frac{\partial S}{\partial t}+\vec {\tilde
u}\cdot\nabla_\vec x S\right)^2=\left|\nabla_\vec x S\right|^2.
\end{equation} in the case of more precise
delicate Geometric Optics approximation, in order to get better
approximation for $U(\vec x,t)$ we can use the following alternative
consideration, slightly different than we did above: from now we
consider $U(\vec x,t)$ be of the form
\begin{equation}\label{MaxVacFullPPNmmmffffffiuiuhjuughbghhuiiujjjjjjjjgjjgiuiuuiint}
U(\vec x,t)=A(\vec x,t)e^{ik_0S(\vec x,t)}\,,
\end{equation}
similarly to
\er{MaxVacFullPPNmmmffffffiuiuhjuughbghhuiiujjjjjjjjgjjgint}.
However, we consider that $S$ in
\er{MaxVacFullPPNmmmffffffiuiuhjuughbghhuiiujjjjjjjjgjjgiuiuuiint}
satisfied
\er{MaxVacFullPPNmmmffffffiuiuhjuughbghhiuijghghghhjhjhhghyuyjjjhhjhjffhjjhjojint}
\underline{precisely}, although we consider the amplitude $A$ to be
\underline{complex} rather than real! So we include the correction
term of $\kappa_0 S$ into the complex amplitude $A$! Thus either
\er{MaxVacFullPPNmmmffffffiuiuhjuughbghhiuijghghghhhfhhghghguygtjjjkkjjkkhhint}
or
\er{MaxVacFullPPNmmmffffffiuiuhjuughbghhiuijghghghhhfhhghghguygtjjjkkjjkkhhklklkjk33hhjjjljint}
is not valid anymore, since it was derived under the assumption of
real amplitude $A$. However, either
\er{MaxVacFullPPNmmmffffffiuiuhjuughbghhiuijghghghhjhjhhghyuyjjjhhjhjffint}
or
\er{MaxVacFullPPNmmmffffffiuiuhjuughbghhiuijghghghhhgjgfghjdfjgddg1133khhujuuuo99klklljljljint}
still holds. I.e. we have
\er{MaxVacFullPPNmmmffffffiuiuhjuughbghhiuijghghghhjhjhhghyuyjjjhhjhjffhjjhjojint},
which assumed to hold precisely. Moreover, by
\er{MaxVacFullPPNmmmffffffiuiuhjuughbghhiuijghghghhjhjhjjuiuiuiint}
together with
\er{MaxVacFullPPNmmmffffffiuiuhjuughbghhiuijghghghhjhjhhghyuyjjjhhjhjffhjjhjojint}
we have
\begin{multline}\label{MaxVacFullPPNmmmffffffiuiuhjuughbghhiuijghghghhjhjhjjuiuiuiuuiuiint}
\left(\frac{\partial}{\partial
t}\left\{\frac{1}{c^2_0}\left(\frac{\partial A}{\partial t}+\vec
{\tilde u}\cdot\nabla_{\vec x}A\right)\right\}+\Div_{\vec x}
\left\{\frac{1}{c^2_0}\left(\frac{\partial A}{\partial t}+\vec
{\tilde u}\cdot\nabla_{\vec x} A\right)\vec {\tilde
u}\right\}\right)-\Delta_{\vec
x}A\\
+i\kappa_0 A\left(\frac{\partial}{\partial
t}\left\{\frac{1}{c^2_0}\left(\frac{\partial S}{\partial t}+\vec
{\tilde u}\cdot\nabla_{\vec x}S\right)\right\}+\Div_{\vec x}
\left\{\frac{1}{c^2_0}\left(\frac{\partial S}{\partial t}+\vec
{\tilde u}\cdot\nabla_{\vec x} S\right)\vec {\tilde
u}\right\}-\Delta_{\vec
x}S\right)\\+2i\kappa_0\left(\frac{1}{c^2_0}\left(\frac{\partial
S}{\partial t}+\vec {\tilde u}\cdot\nabla_{\vec
x}S\right)\left(\frac{\partial A}{\partial t}+\vec {\tilde
u}\cdot\nabla_{\vec x}A\right)-\nabla_{\vec x}A\cdot\nabla_{\vec
x}S\right)=0.
\end{multline}
Then denoting, the proper scalar field:
\begin{equation}\label{MaxVacFullPPNmmmffffffiuiuhjuughbghhiuijghghghhhgjgfghjdfjgddg1133khhujuuuo33int}
\tilde\omega(\vec x,t):=\kappa_0\left(\frac{\partial S}{\partial
t}\left(\vec x,t\right)+\vec {\tilde u}\left(\vec
x,t\right)\cdot\nabla_{\vec x}S\left(\vec x,t\right)\right),
\end{equation}
the proper vector field:
\begin{equation}\label{MaxVacFullPPNmmmffffffiuiuhjuughbghhiuijghghghhhgjgfghjdfjgddg11335533int}
\vec k(\vec x,t):=c_0(\vec x,t)\left(\frac{\partial S}{\partial
t}(\vec x,t)+\vec {\tilde u}(\vec x,t)\cdot\nabla_{\vec x}S(\vec
x,t)\right)^{-1}\nabla_{\vec x}S(\vec x,t),
\end{equation}
the speed-like vector field:
\begin{equation}\label{MaxVacFullPPNmmmffffffiuiuhjuughbghhiuijghghghhhgjgfghjdfjgddg113333int}
\vec h(\vec x,t):=\vec {\tilde u}(\vec x,t)-c^2_0(\vec
x,t)\left(\frac{\partial S}{\partial t}(\vec x,t)+\vec {\tilde
u}(\vec x,t)\cdot\nabla_{\vec x}S(\vec x,t)\right)^{-1}\nabla_{\vec
x}S(\vec x,t)=\vec {\tilde u}(\vec x,t)-c_0(\vec x,t)\vec k(\vec
x,t),
\end{equation}
and the proper scalar field:
\begin{multline}\label{MaxVacFullPPNmmmffffffiuiuhjuughbghhiuijghghghhhfhhghghguygtjuuujjjkhjhjh11kklkloiouu3333int}
G\left(\vec x,t\right):=\\
\left(\frac{c^2_0}{2\left(\frac{\partial S}{\partial t}+\vec {\tilde
u}\cdot\nabla_{\vec x}S\right)}\left(\Delta_{\vec
x}S-\frac{\partial}{\partial
t}\left\{\frac{1}{c^2_0}\left(\frac{\partial S}{\partial t}+\vec
{\tilde u}\cdot\nabla_{\vec x}S\right)\right\}-\Div_{\vec x}
\left\{\frac{1}{c^2_0}\left(\frac{\partial S}{\partial t}+\vec
{\tilde u}\cdot\nabla_{\vec x} S\right)\vec {\tilde
u}\right\}\right)\right)(\vec x,t),
\end{multline}
we clearly rewrite
\er{MaxVacFullPPNmmmffffffiuiuhjuughbghhiuijghghghhjhjhjjuiuiuiuuiuiint}
as:
\begin{multline}\label{MaxVacFullPPNmmmffffffiuiuhjuughbghhiuijghghghhhfhhghghguygtjjjkkjjkkhhklklkjk33hhjjklikint}
\frac{c^2_0}{2\tilde\omega}\left\{\frac{\partial}{\partial
t}\left\{\frac{1}{c^2_0}\left(\frac{\partial A}{\partial t}+\vec
{\tilde u}\cdot\nabla_{\vec x}A\right)\right\}+\Div_{\vec x}
\left\{\frac{1}{c^2_0}\left(\frac{\partial A}{\partial t}+\vec
{\tilde u}\cdot\nabla_{\vec x} A\right)\vec {\tilde
u}\right\}-\Delta_{\vec x}A\right\}\\+i\left(\frac{\partial
A}{\partial t}
+\vec h\cdot\nabla_{\vec x}A
-G\, A\right)=0
,
\end{multline}
and
\er{MaxVacFullPPNmmmffffffiuiuhjuughbghhiuijghghghhjhjhhghyuyjjjhhjhjffhjjhjojint},
as
\begin{equation}\label{MaxVacFullPPNmmmffffffiuiuhjuughbghhiuijghghghhhgjgfghjdfjgddg1133khhujuuuo99klklljljljhjhjhyf11int}
\left|\vec k(\vec x,t)\right|^2=1\quad\quad\text{or
equivalently}\quad\quad \left|\vec h(\vec x,t)-\vec {\tilde u}(\vec
x,t)\right|^2=c^2_0(\vec x,t)\,.
\end{equation}
Furthermore, from now assume the case of \underline{delicate}
Geometric Optics approximation. Remind that the delicate Geometric
Optics approximation means the following: assume that the changes in
time of $c_0$, $\vec {\tilde u}$, $A$ and $S$ become essential after
certain interval of time $T_e$ and the changes in space of $c_0$,
$A$ and $S$ become essential in the spatial landscape $L_e$. Then we
assume that
\begin{equation}\label{slochangGGaaffgfgjhjggh99int}
k^2_0c^2_0T^2_e\,\gg\, 1\quad\text{and}\quad k^2_0L^2_e\,\gg\, 1.
\end{equation}
Moreover, as before, we assume that the order of $c_0$ is less or
equal to the order of $c$. Then, using estimation
\er{slochangGGaaffgfgjhjggh99int} we rewrite
\er{MaxVacFullPPNmmmffffffiuiuhjuughbghhiuijghghghhhfhhghghguygtjjjkkjjkkhhklklkjk33hhjjklikint}
as:
\begin{multline}\label{MaxVacFullPPNmmmffffffiuiuhjuughbghhiuijghghghhhfhhghghguygtjjjkkjjkkhhklklkjk33hhjjklikjjkjkjjjkoi;;lll1jkkjkjjljjjkjjkjkjjjkkiiihhint}
-\frac{c_0}{2\tilde\omega}\left(\nabla_{\vec x}A-\left(\vec
k\cdot\nabla_{\vec x}A\right)\vec k\right)\cdot\left(\left(d_{\vec
x}\vec {\tilde u}+\left\{d_{\vec x}\vec {\tilde
u}\right\}^T\right)\cdot\vec k-\left(\left(\vec
k\cdot\left(\left\{d_{\vec x}\vec {\tilde u}\right\}^T+d_{\vec
x}\vec {\tilde u}\right)\right)\cdot\vec k\right)\vec k\right)\\
+\frac{c^2_0}{2\tilde\omega}A\left(\frac{\partial}{\partial
t}\left\{\frac{1}{c^2_0}G\right\}+\Div_{\vec x}
\left\{\frac{1}{c^2_0}G\vec {\tilde
u}\right\}+\frac{1}{c^2_0}(c_0\vec k)\cdot\nabla_{\vec
x}G+\frac{1}{c^2_0}G^2\right) \\-\frac{c^2_0}{2}\left(\Div_{\vec
x}\left\{\frac{1}{\tilde\omega}\left(\nabla_{\vec x}A-\left(\vec
k\cdot\nabla_{\vec x}A\right)\vec k\right)\right\}\right) \approx
-i\left\{\frac{\partial A}{\partial t}
+\vec h\cdot\nabla_{\vec x}A
+\frac{1}{2}\,\left(\Div_{\vec x}\vec h \right)A\right\}\\-
iA\frac{1}{2c_0}\left(c_0\vec k\cdot\nabla_{\vec
x}c_0\right)+iA\frac{1}{2c_0}\left(\frac{\partial c_0}{\partial
t}+\vec h\cdot\nabla_{\vec x}c_0\right)+ \frac{iA}{4}\left(\vec
k\cdot\left(\left\{d_{\vec x}\vec {\tilde u}\right\}^T+d_{\vec
x}\vec {\tilde u}\right)\right)\cdot\vec k
\end{multline}
(see subsection \ref{gjgggggggggggggggggggghkhk} for the details).
Moreover, as before, we can easily deduce that
\er{MaxVacFullPPNmmmffffffiuiuhjuughbghhiuijghghghhhfhhghghguygtjjjkkjjkkhhklklkjk33hhjjklikjjkjkjjjkoi;;lll1jkkjkjjljjjkjjkjkjjjkkiiihhint}
invariant under the change of non-inertial cartesian coordinate
system, in the case that under such change we have $A'=A$ and
$S'=S$.

In particular, if in some Cartesian coordinate system we assume time
independent settings of the problem so that $\frac{\partial
S}{\partial t}=c$, $\frac{\partial c_0}{\partial t}=0$ and
$\frac{\partial \vec {\tilde u}}{\partial t}=0$, then we also have
$\frac{\partial {\vec k}}{\partial t}=\frac{\partial  {\vec
h}}{\partial t}=0$ and $\frac{\partial \tilde\omega}{\partial
t}=\frac{\partial G}{\partial t}=0$, and therefore, time-independent
solutions $A:=A(\vec x)$, of
\er{MaxVacFullPPNmmmffffffiuiuhjuughbghhiuijghghghhhfhhghghguygtjjjkkjjkkhhklklkjk33hhjjklikjjkjkjjjkoi;;lll1jkkjkjjljjjkjjkjkjjjkkiiihhint}
are admitted and they satisfy
\begin{multline}\label{MaxVacFullPPNmmmffffffiuiuhjuughbghhiuijghghghhhfhhghghguygtjjjkkjjkkhhklklkjk33hhjjklikjjkjkjjjkoi;;lll1jkkjkjjljjjkjjkjkjjjkkiiihhukukuint}
-\frac{c_0}{2\tilde\omega}\left(\nabla_{\vec x}A-\left(\vec
k\cdot\nabla_{\vec x}A\right)\vec k\right)\cdot\left(\left(d_{\vec
x}\vec {\tilde u}+\left\{d_{\vec x}\vec {\tilde
u}\right\}^T\right)\cdot\vec k-\left(\left(\vec
k\cdot\left(\left\{d_{\vec x}\vec {\tilde u}\right\}^T+d_{\vec
x}\vec {\tilde u}\right)\right)\cdot\vec k\right)\vec k\right)\\
+\frac{c^2_0}{2\tilde\omega}A\left(\Div_{\vec x}
\left\{\frac{1}{c^2_0}G\vec {\tilde
u}\right\}+\frac{1}{c^2_0}(c_0\vec k)\cdot\nabla_{\vec
x}G+\frac{1}{c^2_0}G^2\right) -\frac{c^2_0}{2}\left(\Div_{\vec
x}\left\{\frac{1}{\tilde\omega}\left(\nabla_{\vec x}A-\left(\vec
k\cdot\nabla_{\vec x}A\right)\vec k\right)\right\}\right)\\ \approx
-i\left\{\vec h\cdot\nabla_{\vec x}A
+\frac{1}{2}\,\left(\Div_{\vec x}\vec h \right)A\right\}\\-
iA\frac{1}{2c_0}\left(c_0\vec k\cdot\nabla_{\vec
x}c_0\right)+iA\frac{1}{2c_0}\left(\vec h\cdot\nabla_{\vec
x}c_0\right)+ \frac{iA}{4}\left(\vec k\cdot\left(\left\{d_{\vec
x}\vec {\tilde u}\right\}^T+d_{\vec x}\vec {\tilde
u}\right)\right)\cdot\vec k.
\end{multline}

Finally, from now we assume that $\vec {\tilde u}$ and $c_0$ vary
sufficiently slowly in space and time variables, so that the
following approximations are valid:
\begin{equation}\label{MaxVacFullPPNmmmffffffhhtygghGGGGyuhggghghghffgiooiiojhjjhhjuiuiugughhjggjjhkint}
\frac{|\nabla_{\vec x}c_0|+\frac{1}{c_0}\left|\frac{\partial
c_0}{\partial t}+\vec {\tilde u}\cdot\nabla_{\vec x}c_0\right|}{
c_0}+\frac{\left| d_{\vec x}\vec {\tilde u}+\left\{d_{\vec x}\vec
{\tilde u}\right\}^T\right|}{
c_0}\ll\frac{\left(| \nabla_{\vec
x}A|+\frac{1}{c_0}\left|\frac{\partial A}{\partial t}+\vec {\tilde
u}\cdot\nabla_{\vec x}A\right|\right)}{|A|},
\end{equation}
In particular, if in some Cartesian coordinate system we have
\begin{equation}\label{MaxVacFullPPNmmmffffffiuiuhjuughbghhiuijghghghhjhjhhghyuyhhjhjihikjjjjkjljklghhggghint}
\begin{cases}
\frac{\left| d_{\vec x}\vec {\tilde u}+\left\{d_{\vec x}\vec {\tilde
u}\right\}^T\right|^2}{
|\vec {\tilde u}|^2}\ll\frac{\left(| \nabla_{\vec
x}A|+\frac{1}{c_0}\left|\frac{\partial A}{\partial t}+\vec {\tilde
u}\cdot\nabla_{\vec x}A\right|\right)^2}{|A|^2}\\
\frac{\left|\vec {\tilde u}\right|^2}{
c^2_0}\ll 1\\
\frac{|\nabla_{\vec x}c_0|+\frac{1}{c_0}\left|\frac{\partial
c_0}{\partial t}+\vec {\tilde u}\cdot\nabla_{\vec x}c_0\right|}{
c_0}\ll\frac{\left(| \nabla_{\vec
x}A|+\frac{1}{c_0}\left|\frac{\partial A}{\partial t}+\vec {\tilde
u}\cdot\nabla_{\vec x}A\right|\right)}{|A|},
\end{cases}
\end{equation}
then
\er{MaxVacFullPPNmmmffffffhhtygghGGGGyuhggghghghffgiooiiojhjjhhjuiuiugughhjggjjhkint}
indeed holds! Then, by
\er{MaxVacFullPPNmmmffffffhhtygghGGGGyuhggghghghffgiooiiojhjjhhjuiuiugughhjggjjhkint}
we rewrite and simplify
\er{MaxVacFullPPNmmmffffffiuiuhjuughbghhiuijghghghhhfhhghghguygtjjjkkjjkkhhklklkjk33hhjjklikjjkjkjjjkoi;;lll1jkkjkjjljjjkjjkjkjjjkkiiihhint}
as
\begin{multline}\label{MaxVacFullPPNmmmffffffiuiuhjuughbghhiuijghghghhhfhhghghguygtjjjkkjjkkhhklklkjk33hhjjklikjjkjkjjjkoi;;lll1jkkjkjjljjjkjjkjkjjjkkiiihhihhjhint}
\frac{1}{2\tilde\omega}A\left(\frac{\partial G}{\partial
t}+\Div_{\vec x} \left\{G\vec {\tilde u}\right\}+(c_0\vec
k)\cdot\nabla_{\vec x}G+G^2\right)
\\-\frac{1}{2}\left(\Div_{\vec
x}\left\{\frac{c^2_0}{\tilde\omega}\left(\nabla_{\vec x}A-\left(\vec
k\cdot\nabla_{\vec x}A\right)\vec k\right)\right\}\right) \approx
-i\left\{\frac{\partial A}{\partial t}
+\vec h\cdot\nabla_{\vec x}A
+\frac{1}{2}\,\left(\Div_{\vec x}\vec h \right)A\right\}
\end{multline}
(see subsection \ref{gjgggggggggggggggggggghkhk} for the details).
Moreover, as before, we can easily deduce that
\er{MaxVacFullPPNmmmffffffiuiuhjuughbghhiuijghghghhhfhhghghguygtjjjkkjjkkhhklklkjk33hhjjklikjjkjkjjjkoi;;lll1jkkjkjjljjjkjjkjkjjjkkiiihhihhjhint}
invariant under the change of non-inertial cartesian coordinate
system, in the case that under such change we have $A'=A$ and
$S'=S$.

Furthermore, in time independent settings we rewrite
\er{MaxVacFullPPNmmmffffffiuiuhjuughbghhiuijghghghhhfhhghghguygtjjjkkjjkkhhklklkjk33hhjjklikjjkjkjjjkoi;;lll1jkkjkjjljjjkjjkjkjjjkkiiihhukukuint}
as
\begin{multline}\label{MaxVacFullPPNmmmffffffiuiuhjuughbghhiuijghghghhhfhhghghguygtjjjkkjjkkhhklklkjk33hhjjklikjjkjkjjjkoi;;lll1jkkjkjjljjjkjjkjkjjjkkiiihhihhjhijjkint}
\frac{1}{2\tilde\omega}A\left(\Div_{\vec x} \left\{G\vec {\tilde
u}\right\}+(c_0\vec k)\cdot\nabla_{\vec x}G+G^2\right)
\\-\frac{1}{2}\left(\Div_{\vec
x}\left\{\frac{c^2_0}{\tilde\omega}\left(\nabla_{\vec x}A-\left(\vec
k\cdot\nabla_{\vec x}A\right)\vec k\right)\right\}\right) \approx
-i\left\{\vec h\cdot\nabla_{\vec x}A
+\frac{1}{2}\,\left(\Div_{\vec x}\vec h \right)A\right\}.
\end{multline}
In particular, in the case $\vec {\tilde u}=0$ and constant $\vec
k$, $c_0$ and $\tilde\omega$,
\er{MaxVacFullPPNmmmffffffiuiuhjuughbghhiuijghghghhhfhhghghguygtjjjkkjjkkhhklklkjk33hhjjklikjjkjkjjjkoi;;lll1jkkjkjjljjjkjjkjkjjjkkiiihhihhjhijjkint}
simplifies as the paraxial approximation of the Helmholtz equation,
with respect to direction $(-\vec k)$, which is common in Optics:
\begin{equation}\label{MaxVacFullPPNmmmffffffiuiuhjuughbghhiuijghghghhhfhhghghguygtjjjkkjjkkhhklklkjk33hhjjklikjjkjkjjjkoi;;lll1jkkjkjjljjjkjjkjkjjjkkiiihhihhjhijjkjkjkjkklkint}
\frac{c_0}{2\tilde\omega}\left(\Div_{\vec
x}\left\{\left(\nabla_{\vec x}A-\left(\vec h\cdot\nabla_{\vec
x}A\right)\vec h\right)\right\}\right) \approx
-i\vec k\cdot\nabla_{\vec x}A .
\end{equation}

Next, note that, the time dependent equation
\er{MaxVacFullPPNmmmffffffiuiuhjuughbghhiuijghghghhhfhhghghguygtjjjkkjjkkhhklklkjk33hhjjklikjjkjkjjjkoi;;lll1jkkjkjjljjjkjjkjkjjjkkiiihhihhjhint}
is very similar to the Schr\"{o}dinger equation and similarly can be
written in the form
\begin{equation}\label{MaxVacFullPPNmmmffffffiuiuhjuughbghhiuijghghghhhfhhghghguygtjjjkkjjkkhhklklkjk33hhjjklikjjkjkjjjkoi;;lll1jkkjkjjljjjkjjkjkjjjkkiiihhihhjhuouoint}
 i\hbar\frac{\partial A}{\partial t} =\hat H\cdot A.
\end{equation}
where
\begin{multline}\label{MaxVacFullPPNmmmffffffiuiuhjuughbghhiuijghghghhhfhjkkhkhghghguygtjjjkkjjkkhhklklkjk33hhjjklikjjkjkjjjkoi;;lll1jkkjkjjljjjkjjkjkjjjkkiiihhihhjhkklhkhkint}
\hat H\cdot A:= \frac{\hbar}{2}\left(\Div_{\vec
x}\left\{\frac{c^2_0}{\tilde\omega}\left(\nabla_{\vec x}A-\left(\vec
k\cdot\nabla_{\vec x}A\right)\vec k\right)\right\}\right)
-i\hbar\left\{
\vec h\cdot\nabla_{\vec x}A
+\frac{1}{2}\,\left(\Div_{\vec x}\vec h \right)A\right\}\\
-\frac{\hbar}{2\tilde\omega}\left(\frac{\partial G}{\partial
t}+\Div_{\vec x} \left\{G\vec {\tilde u}\right\}+(c_0\vec
k)\cdot\nabla_{\vec x}G+G^2\right)A
\end{multline}
is a self-adjoint linear operator on the complex Hilbert space.

Finally, solving either
\er{MaxVacFullPPNmmmffffffiuiuhjuughbghhiuijghghghhhfhhghghguygtjjjkkjjkkhhklklkjk33hhjjklikjjkjkjjjkoi;;lll1jkkjkjjljjjkjjkjkjjjkkiiihhint}
or
\er{MaxVacFullPPNmmmffffffiuiuhjuughbghhiuijghghghhhfhhghghguygtjjjkkjjkkhhklklkjk33hhjjklikjjkjkjjjkoi;;lll1jkkjkjjljjjkjjkjkjjjkkiiihhihhjhint}
and inserting the solution into
\er{MaxVacFullPPNmmmffffffiuiuhjuughbghhuiiujjjjjjjjgjjgiuiuuiint}
we established $U(\vec x,t)$ in
\er{MaxVacFullPPNmmmffffffiuiuhjuughbghhjjjoojouiuiuint} in the
\underline{delicate} Geometric Optics approximation
\er{slochangGGaaffgfgjhjgghint}.

\subsubsection{Wave optics inside a moving medium and/or in the
presence of gravitational field}\label{jghggghghghint} Assume that
in some inertial or non-inertial cartesian coordinate system a
\underline{complex valued} scalar field $U:=U(\vec x,t)$,
characterizing some wave, satisfies the wave equation
\er{MaxVacFullPPNmmmffffffiuiuhjuughbghhjjint}:
\begin{equation}\label{MaxVacFullPPNmmmffffffiuiuhjuughbghhjjhh11int}
\frac{\partial}{\partial
t}\left\{\frac{1}{c^2_0}\left(\frac{\partial U}{\partial t}+\vec
{\tilde u}\cdot\nabla_{\vec x}U\right)\right\}+\Div_{\vec x}
\left\{\frac{1}{c^2_0}\left(\frac{\partial U}{\partial t}+\vec
{\tilde u}\cdot\nabla_{\vec x} U\right)\vec {\tilde
u}\right\}-\Delta_{\vec x}U=0,
\end{equation}
where, as before, $\vec {\tilde u}:=\vec {\tilde u}(\vec x,t)$ is
some moderately changing (in space and in time) speed-like vector
field and $c_0:=c_0(\vec x,t)>0$ is a moderately changing scalar
quantity, that we call wave propagation speed. In particular, if we
assume that the fields $\vec {\tilde u}$ and $c_0$ are independent
on the time variable, and we consider the case of a monochromatic
wave of the constant frequency $\nu=\frac{\omega}{2\pi}$:
\begin{equation}\label{MaxVacFullPPNmmmffffffiuiuhjuughbghhiuijghghghhjhjhhghyuyhhjhjihikjjjjkjljklhh11int}
U(\vec x,t):=\mathcal{U}(\vec x)e^{i\omega t}\quad\quad\forall(\vec
x,t),
\end{equation}
where $\mathcal{U}:=\mathcal{U}(\vec x)$ is a complex-valued scalar
field, independent of time, then inserting
\er{MaxVacFullPPNmmmffffffiuiuhjuughbghhiuijghghghhjhjhhghyuyhhjhjihikjjjjkjljklhh11int}
into \er{MaxVacFullPPNmmmffffffiuiuhjuughbghhjjhh11int} gives the
following time independent equation:
\begin{equation}\label{MaxVacFullPPNmmmffffffiuiuhjuughbghhjjuggk;klklkljklint}
\frac{i\omega}{c^2_0(\vec x)}\left(i\omega \mathcal{U}(\vec x)+\vec
{\tilde u}(\vec x)\cdot\nabla_{\vec x}\mathcal{U}(\vec
x)\right)+\Div_{\vec x} \left\{\frac{1}{c^2_0(\vec x)}\left(i\omega
\mathcal{U}(\vec x)+\vec {\tilde u}(\vec x)\cdot\nabla_{\vec x}
\mathcal{U}(\vec x)\right)\vec {\tilde u}(\vec
x)\right\}-\Delta_{\vec x}\mathcal{U}(\vec x)=0.
\end{equation}

Next, suppose that, although we cannot consider the Geometric Optics
approximation \er{slochangGGaaffgfgjhjgghint}, we can consider,
however, the following assumption:
\begin{equation}\label{iggguggkhhhhint}
\vec {\tilde u}(\vec x,t)=\nabla_{\vec x}\tilde Z(\vec x,t)
\quad\quad\forall(\vec x,t),
\end{equation}
where $\tilde Z:=\tilde Z(\vec x,t)$ is some real-valued scalar
function, which satisfies
\begin{equation}\label{iggguggkhhhh22int}
\frac{1}{c^2_0}\left|\frac{\partial \tilde Z}{\partial
t}\right|+\frac{1}{c^2_0}\left|\nabla_{\vec x}\tilde Z\right|^2\ll
1.
\end{equation}
Note that \er{iggguggkhhhhint} is valid in the particular case of
the electromagnetic waves, propagating in vacuum, in the
\underline{inertial} or-more generally non-rotating cartesian
coordinate system (where we indeed have $\vec {\tilde u}=\vec v$ and
$curl_{\vec x}\vec v=0$). On the other hand, \er{iggguggkhhhh22int}
means that $\frac{\left|\vec {\tilde u}\right|^2}{c^2_0}\ll 1$ and
$\vec {\tilde u}$ is either independent of time or depends on time
slowly. Moreover, assume that $c_0$ is a constant in the given
region.

Next we do the change of variables $(\vec x,t)\to(\vec z,\tau)$ as
\begin{equation}\label{hvkgkjgkjkhhhhjjkhkhkkjjjjhgjint}
\begin{cases}
\vec z:=\vec x,\\
\tau:=t+\frac{1}{c^2_0}\tilde Z(\vec x,t),
\end{cases}
\end{equation}
so that, we can find a complex-valued function $V:=V(\vec z,\tau)$
depending on  $\vec z\in\mathbb{R}^3$ and a real $\tau$, which
satisfies:
\begin{equation}\label{hvkgkjgkjkhhhhjjkhkhkint}
U(\vec x,t):=V\left(\vec x,t+\frac{1}{c^2_0}\tilde Z(\vec
x,t)\right).
\end{equation}
Then in subsection \ref{jghggghghgh} we deduce that the function
$V(\vec z,\tau)$ approximately solves the standard type of the wave
equation of the form:
\begin{equation}\label{MaxVacFullPPNmmmffffffiuiuhjuughbghhjjkhhkhhhhhjint}
\frac{1}{c^2_0}\frac{\partial^2 V}{\partial\tau^2}\left(\vec
z,\tau\right)-\Delta_{\vec z}V\left(\vec z,\tau\right)=0.
\end{equation}
Fortunately, the plenty of tools for analytical resolution of the
\underline{standard} wave equation
\er{MaxVacFullPPNmmmffffffiuiuhjuughbghhjjkhhkhhhhhjint} is well
known. In particular, if we assume that the fields $\vec {\tilde u}$
and $c_0$ are independent of the time variable, and we consider the
case of a monochromatic wave of the constant frequency
$\nu=\frac{\omega}{2\pi}$, as in
\er{MaxVacFullPPNmmmffffffiuiuhjuughbghhiuijghghghhjhjhhghyuyhhjhjihikjjjjkjljklhh11int},
then, we can choose $\tilde Z$ to be independent of time and
moreover, inserting
\er{MaxVacFullPPNmmmffffffiuiuhjuughbghhiuijghghghhjhjhhghyuyhhjhjihikjjjjkjljklhh11int}
into \er{hvkgkjgkjkhhhhjjkhkhkint} gives:
\begin{equation}\label{hvkgkjgkjkhhhhjjkhkhkhkhhjhgjkjiuiiyjljlint}
V\left(\vec z,\tau\right)=\mathcal{V}\left(\vec z\right)e^{i\omega
\tau}\quad\quad\text{and}\quad\quad \mathcal{U}(\vec
x):=\mathcal{V}\left(\vec x\right)e^{\frac{i\omega }{c^2_0(\vec
x)}\tilde Z(\vec x)},
\end{equation}
where $\mathcal{V}:=\mathcal{V}(\vec z)$ is a complex-valued scalar
field, independent of time. Moreover, inserting
\er{hvkgkjgkjkhhhhjjkhkhkhkhhjhgjkjiuiiyjljlint} into
\er{MaxVacFullPPNmmmffffffiuiuhjuughbghhjjkhhkhhhhhjint} gives that
$\mathcal{V}(\vec x)$ solves the Helmholtz equation of the form:
\begin{equation}\label{MaxVacFullPPNmmmffffffiuiuhjuughbghhjjuggk;klklkljklihihjhgjggint}
\frac{\omega^2}{c^2_0}\mathcal{V}(\vec x)+\Delta_{\vec
x}\mathcal{V}(\vec x)=0,
\end{equation}
and
\begin{equation}\label{MaxVacFullPPNmmmffffffiuiuhjuughbghhjjuggk;klklkljklihihjhgjggjkijjhjhhint}
U(\vec x,t)=\mathcal{V}\left(\vec
x\right)e^{i\omega\left(t+\frac{1}{c^2_0}\tilde Z(\vec x)\right)}.
\end{equation}
Thus, if, as before in
\er{MaxVacFullPPNmmmffffffiuiuhjuughbghhuiiujjjjjjjjhhhjjjkkint}, we
denote $k_0:=\frac{\omega}{c}$, where $c$ is a constant in the
Maxwell equations for the vacuum, and express $U(\vec x,t)$ as in
\er{MaxVacFullPPNmmmffffffiuiuhjuughbghhuiiujjint} by the following:
\begin{equation}\label{MaxVacFullPPNmmmffffffiuiuhjuughbghhuiiujjuyughghint}
U(\vec x,t)=A(\vec x,t)e^{ik_0S(\vec x,t)}\,,
\end{equation}
where $A:=A(\vec x,t)$ and $S:=S(\vec x,t)$ are real scalar fields,
then by
\er{MaxVacFullPPNmmmffffffiuiuhjuughbghhjjuggk;klklkljklihihjhgjggjkijjhjhhint}
we deduce:
\begin{equation}\label{MaxVacFullPPNmmmffffffiuiuhjuughbghhuiiujjuyughghioouiint}
U(\vec x,t)=A(\vec x,t)e^{ik_0S(\vec x,t)},\quad\text{where}\quad
A(\vec x,t)=B(\vec x)\quad\text{and}\quad S(\vec x,t)=ct+P(\vec
x)+\frac{c}{c^2_0}\tilde Z(\vec x),
\end{equation}
with
\begin{equation}\label{MaxVacFullPPNmmmffffffiuiuhjuughbghhjjuggk;klklkljklihihjhgjggjkijjhjhhjoojjioioint}
B(\vec x)e^{ik_0P(\vec x)}:=\mathcal{V}\left(\vec x\right),
\end{equation}
where $B:=B(\vec x)$ and $P:=P(\vec x)$ are real scalar fields and
$\mathcal{V}\left(\vec x\right)$ is a complex solution of
\er{MaxVacFullPPNmmmffffffiuiuhjuughbghhjjuggk;klklkljklihihjhgjggint}.
In particular, if the vector field $\vec h_0$ is the direction of
the propagation of the ray that passes through point $\vec x$,
defined for every $\vec x$, as in
\er{MaxVacFullPPNmmmffffffiuiuhjuughbghhiuijghghghhhfhhghghguygtjuuujjjkyuuyint}
by:
\begin{equation}\label{MaxVacFullPPNmmmffffffiuiuhjuughbghhiuijghghghhhfhhghghguygtjuuujjjkyuuyjhjhjhint}
\vec h_0(\vec x):=\frac{c}{c^2_0}\vec {\tilde u}(\vec
x)-\nabla_{\vec x}S(\vec x),
\end{equation}
then by inserting
\er{MaxVacFullPPNmmmffffffiuiuhjuughbghhuiiujjuyughghioouiint} and
\er{iggguggkhhhhint} into
\er{MaxVacFullPPNmmmffffffiuiuhjuughbghhiuijghghghhhfhhghghguygtjuuujjjkyuuyjhjhjhint}
gives
\begin{equation}\label{MaxVacFullPPNmmmffffffiuiuhjuughbghhiuijghghghhhfhhghghguygtjuuujjjkyuuyjhjhjhjhjhjhhjhint}
\vec h_0(\vec x)=
-\nabla_{\vec x}P(\vec x).
\end{equation}
As $P(\vec x)$ being the phase of $\mathcal{V}(\vec x)$ is
independent of the field $\vec {\tilde u}$ we obtain that the
direction of the ray $\vec h$, in the general case of nontrivial
field $\vec {\tilde u}$, is the same as in the case  $\vec {\tilde
u}(\vec x)\equiv 0$. So the nontrivial motion of the medium and the
nontrivial vectorial gravitational potential dose not affect the
direction of the ray propagation $\vec h$, provided we assume
\er{iggguggkhhhhint} and \er{iggguggkhhhh22int}. Moreover, if we
denote by $U(\vec x,t)=A(\vec x,t)e^{ik_0S(\vec x,t)}$ the
monochromatic wave solution of the problem in the general case of
nontrivial field $\vec {\tilde u}$ and by $U_0(\vec x,t)=A_0(\vec
x,t)e^{ik_0S_0(\vec x,t)}$ the monochromatic wave solution of the
corresponding simpler problem in the case of the trivial field $\vec
{\tilde u}\equiv 0$ then we have
\begin{equation}\label{MaxVacFullPPNmmmffffffiuiuhjuughbghhuiiujjuyughghioouijkkjhjint}
U(\vec x,t)=U_0(\vec x,t)e^{\frac{i\omega}{c^2_0}\tilde Z(\vec
x)},\quad A(\vec x,t)=A_0(\vec x,t)\quad\text{and}\quad S(\vec
x,t)=S_0(\vec x,t)+\frac{c}{c^2_0}\tilde Z(\vec x),
\end{equation}
i.e. in the case, where \er{iggguggkhhhhint} and
\er{iggguggkhhhh22int} hold, the wave solution for the general case
of nontrivial $\vec {\tilde u}$ can be obtained from the
corresponding solution in the simplest trivial case by simple adding
the term $\frac{\omega}{c^2_0}\tilde Z(\vec x)$ to the phase. In
particular, all the pictures of the interference or the diffraction
for given nontrivial $\vec {\tilde u}$ will be the same as in the
trivial case, provided we assume \er{iggguggkhhhhint} and
\er{iggguggkhhhh22int}. Finally, by simple adding the phase
$\frac{\omega}{c^2_0}\tilde Z(\vec x)$ we obtain the particular
monochromatic solutions of
\er{MaxVacFullPPNmmmffffffiuiuhjuughbghhjjhh11int} from the
following well known particular monochromatic solutions of the
simplest wave equation (i.e. for the case $\vec {\tilde u}\equiv
0$):
\begin{itemize}
\item
plane wave,
\item
spherical wave,
\item
the approximate solution in the form of the Gaussian beam.
\end{itemize}

\section{Notations and preliminaries} \noindent$\bullet$ By
$\R^{p\times q}$ we denote the set of $p\times q$-matrixes with real
coefficients.



\noindent$\bullet$ For a $p\times q$ matrix $A$ with $ij$-th entry
$a_{ij}$ and for a $q\times d$ matrix $B$ with $ij$-th entry
$b_{ij}$ we denote by $AB:=A\cdot B$ their product, i.e. the
$p\times d$ matrix, with $ij$-th entry
$\sum\limits_{k=1}^{q}a_{ik}b_{kj}$.

\noindent$\bullet$ We identify a vector $\vec
u=(u_1,\ldots,u_q)\in\R^q$ with the $q\times 1$ matrix
having $i1$-th entry $u_i$, so that for the $p\times q$ matrix $A$
with $ij$-th entry $a_{ij}$ and for $\vec
v=(v_1,v_2,\ldots,v_q)\in\R^q$ we denote by $A\,\vec v :=A\cdot\vec
v$ the $p$-dimensional vector $\vec u=(u_1,\ldots,u_p)\in\R^p$,
given by $u_i=\sum\limits_{k=1}^{q}a_{ik}v_k$ for every $1\leq i\leq
p$.

\noindent$\bullet$ For a $p\times q$ matrix $A$ with $ij$-th entry
$a_{ij}$ denote by $A^T$ the transpose $q\times p$ matrix with
$ij$-th entry $a_{ji}$.

\noindent$\bullet$ For a $p\times p$ matrix $A$ with $ij$-th entry
$a_{ij}$ denote $tr(A):=\sum_{k=1}^{p}a_{kk}$ (the trace of the
matrix $A$).

\noindent$\bullet$ For a $p\times q$ real matrices $A$ and $B$ with
$ij$-th entries $a_{ij}$ and $b_{ij}$ denote by $A:B$ their scalar
product $A:B:=tr(A\cdot
B^T):=\sum_{j=1}^{p}\sum_{k=1}^{q}a_{jk}b_{jk}$

\noindent $\bullet$ For a $p\times p$ real matrix $A$ with $ij$-th
entry $a_{ij}$ denote by $|A|$ its standard norm
$|A|:=\sqrt{A:A}=\sqrt{\sum_{j=1}^{p}\sum_{k=1}^{p}a^2_{jk}}$.

\noindent$\bullet$ For $\vec u=(u_1,\ldots,u_p)\in\R^p$ and $\vec
v=(v_1,\ldots,v_p)\in\R^p$ we denote by $\vec u\vec v:=\vec
u\cdot\vec v:=\sum\limits_{k=1}^{p}u_k v_k$ the standard scalar
product. We also note that $\vec u\vec v=\vec u^T\vec v=\vec v^T\vec
u$ as products of matrices.

\noindent$\bullet$ For $\vec u=(u_1,u_2,u_3)\in\R^3$ and $\vec
v=(v_1,v_2,v_3)\in\R^3$ we denote
$$\vec u\times\vec v:=
\left(u_2 v_3-u_3 v_2, u_3 v_1-u_1 v_3, u_1 v_2-u_2
v_1\right)\in\R^3.$$

\noindent$\bullet$ For $\vec u=(u_1,\ldots,u_p)\in\R^p$ and $\vec
v=(v_1,\ldots,v_q)\in\R^q$ we denote by $\vec u\otimes\vec v$ the
$p\times q$ matrix with $ij$-th entry $u_i v_j$ (i.e. $\vec
u\otimes\vec v=\vec u\,\vec v^T$ as a product of matrices).

\noindent$\bullet$
Given a vector valued function $\vec f(\vec x)=\left(f_1(\vec
x),\ldots,f_k(\vec x)\right):\O\to\R^k$ ($\O\subset\R^N$) we denote
by $D\vec f$ the $k\times N$ matrix with $ij$-th entry
$\frac{\partial f_i}{\partial x_j}$. In the case of a scalar valued
function $\psi(\vec x):\O\subset\R^N\to\R$ we associate with $D
\psi$ (which, by definition, belongs to $\R^{1\times N}$) the
corresponding vector $\nabla \psi:=\left(\frac{\partial
\psi}{\partial x_1},\ldots,\frac{\partial \psi}{\partial
x_N}\right)$.

\noindent$\bullet$ Given a matrix valued function $F(\vec
x):=\{F_{ij}(\vec x)\}:\O\subset\R^N\to\R^{k\times N}$,
we denote by ${div}\, F$ the $\R^k$-valued vector field defined by
$div\,F(\vec x):=(l_1,\ldots,l_k)(\vec x)$ where $l_i(\vec
x)=\sum\limits_{j=1}^{N}\frac{\partial F_{ij}}{\partial x_j}(\vec
x)$. Given a vector valued function $\vec f(\vec
x):=\left(f_{1}(\vec x),\ldots, f_{N}(\vec
x)\right):\O\subset\R^N\to\R^{N}$ we denote
${div}\, \vec f:=\sum\limits_{j=1}^{N}\frac{\partial f_j}{\partial
x_j}$.

\noindent$\bullet$ Given a scalar or vector valued function $\vec
f(\vec x):\O\subset\R^N\to\R^{k}$ we denote by $\Delta \vec f$ the
Laplacian of $\vec f$ defined by $\Delta \vec
f:=\sum\limits_{j=1}^{N}\frac{\partial^2\vec  f}{\partial x^2_j}$.

\noindent$\bullet$ Given a vector valued function $\vec f(\vec
x)=\left(f_1(\vec x),f_2(\vec x),f_3(\vec
x)\right):G\subset\R^3\to\R^3$
we denote
$$curl\,\vec f(\vec x):=\left(\frac{\partial f_3}{\partial x_2}-\frac{\partial f_2}{\partial x_3}\,,
\frac{\partial f_1}{\partial x_3}-\frac{\partial f_3}{\partial
x_1}\,,\frac{\partial f_2}{\partial x_1}-\frac{\partial
f_1}{\partial x_2}\right)(\vec x).$$
%
%
%
%
%
%
%
%
%
%
We have the following trivial identities:
\begin{align}
\vec a\times \vec b=-\vec b\times\vec a\quad\text{and}\quad \vec
a\cdot(\vec b\times\vec c)=(\vec a\times\vec b)\cdot\vec
c\quad\quad\forall\,\vec a,\vec b,\vec c\in\R^3,\label{apfrm1}\\
\vec a\times(\vec b\times\vec c)=(\vec a\cdot\vec c)\,\vec b-(\vec
a\cdot\vec b)\,\vec c\quad\quad\forall\,\vec a,\vec b,\vec
c\in\R^3,\label{apfrm2}\\
(A\cdot \vec b)\times \vec c-(A\cdot \vec c)\times \vec
b=tr(A)\,(\vec b\times \vec c)-A^T\cdot(\vec b\times \vec
c)\quad\quad\forall A\in\R^{3\times 3},\;\forall\,\vec b,\vec
c\in\R^3,\label{apfrm10}\\
A^T\cdot\left((A\cdot \vec b)\times (A\cdot \vec c)\right)=(det\,
A)\,(\vec b\times \vec c)\quad\quad\forall A\in\R^{3\times
3},\;\forall\,\vec b,\vec
c\in\R^3,\label{apfrm90}\\
{div}(\vec f\times \vec g)=\vec g\cdot curl\,\vec f-\vec f\cdot
curl\,\vec g\quad\quad
\forall\,\vec f,\vec g:G\subset\R^3\to\R^3,\label{apfrm3}\\
{div}(\psi \vec f)=\psi\,{div}\, \vec f+\nabla\psi\cdot \vec
f\quad\quad
\forall\,\psi:G\subset\R^3\to\R,\;\;
\forall\,\vec f:G\subset\R^3\to\R^3,\label{apfrm4}\\
curl\,(\psi \vec f)=\psi\,curl\, \vec f+\nabla\psi\times \vec
f\quad\quad \forall\,\psi:G\subset\R^3\to\R,\;\;
\forall\,\vec f:G\subset\R^3\to\R^3,\label{apfrm5}\\
{div}\left(curl\,\vec f\right)=0\quad\quad\forall\,\vec
f:G\subset\R^3\to\R^3,\label{apfrm11}\\
{curl}\left(\nabla\psi\right)=0\quad\quad\forall\,\vec
\psi:G\subset\R^3\to\R,\label{apfrm100}
\\ curl\,\left(curl\,\vec f\right)=\nabla\left({div}\, \vec f\right)-\Delta\vec
f\quad\quad\forall\,\vec f:G\subset\R^3\to\R^3,\label{apfrm12}
\\ curl\,(\vec f\times \vec g)=({div}\, \vec g)\,\vec f -({div}\, \vec
f)\,\vec g +(D \vec f)\cdot \vec g-(D \vec g)\cdot \vec f\quad\quad
\forall\,\vec f,\vec g:G\subset\R^3\to\R^3,\label{apfrm6}\\
curl\,(\vec f\times \vec g)=div\left(\vec f\otimes \vec g-\vec
g\otimes \vec f\right)\quad\quad
\forall\,\vec f,\vec g:G\subset\R^3\to\R^3,\label{apfrm6kkk}\\
div\left(\vec f\otimes \vec g\right)=(D \vec f)\cdot \vec g+({div}\,
\vec g)\,\vec f\quad\quad
\forall\,\vec f,\vec g:G\subset\R^3\to\R^3,\label{apfrm6kkkppp}\\
\nabla(\vec f\cdot \vec g)=(D \vec f)^T\cdot \vec g+(D
\vec g)^T\cdot \vec f\quad\quad \forall\,\vec f,\vec g:G\subset\R^3\to\R^3,\label{apfrm7}\\
\vec f\times(curl\,\vec g)=(D \vec g)^T\cdot \vec f-(D
\vec g)\cdot \vec f\quad\quad \forall\,\vec f,\vec g:G\subset\R^3\to\R^3,\label{apfrm9}\\
\nabla(\vec f\cdot \vec g)=\vec f\times(curl\,\vec g)+\vec
g\times(curl\,\vec f)+(D \vec f)\cdot \vec g+(D \vec g)\cdot \vec
f\quad\quad \forall\,\vec f,\vec
g:G\subset\R^3\to\R^3,\label{apfrm8}
\end{align}
where
we mean by $A\cdot \vec l$  the usual product of matrix
$A\in\R^{3\times 3}$ and vector $\vec l\in\R^3$ and by $A^T$ we mean
the transpose of matrix $A$.

\section{Transformations of scalar and vector fields under the change of inertial or non-inertial cartesian coordinate
system}\label{fhfgfghdfdfdfd}
\begin{definition}\label{bggghghgj}
Consider the change of some inertial or non-inertial cartesian
coordinate system $(*)$ to another cartesian coordinate system
$(**)$ of the form:
\begin{equation}\label{noninchgravortbstr}
\begin{cases}
\vec x'=A(t)\cdot\vec x+\vec z(t),\\
t'=t,
\end{cases}
\end{equation}
where $A(t)\in SO(3)$ is a rotation, i.e. $A(t)\in \R^{3\times 3}$,
$det\, A(t)>0$ and $A(t)\cdot A^T(t)=I$.
\begin{itemize}
\item
We say that the scalar field $\psi:=\psi(\vec
x,t):\R^3\times[t_0,+\infty)\to\R$ is a proper scalar field if,
under every change of coordinate system given by
\er{noninchgravortbstr}, this field transforms by the law:
\begin{equation}\label{uguyytfdddd}
\psi'(\vec x',t')=\psi(\vec x,t).
\end{equation}

\item
We say that the vector field $\vec f:=\vec f(\vec
x,t):\R^3\times[t_0,+\infty)\to\R^3$ is a proper vector field if,
under every change of coordinate system given by
\er{noninchgravortbstr}, this field transforms by the law:
\begin{equation}\label{uguyytfddddgghjjg}
\vec f'(\vec x',t')=A(t)\cdot\vec f(\vec x,t),
\end{equation}

\item
We say that the vector field $\vec v:=\vec v(\vec
x,t):\R^3\times[t_0,+\infty)\to\R^3$ is a speed-like vector field
if, under every change of coordinate system given by
\er{noninchgravortbstr}, this field transforms by the law:
\begin{equation}
\label{NoIn5redbstr}\vec v'(\vec x',t')=A(t)\cdot \vec v(\vec x,t)+
\frac{d A}{dt}(t)\cdot\vec x+
\vec w(t),
\end{equation}
where we set
\begin{equation}\label{buitguihjk}
\vec w(t):=\frac{d\vec z}{dt}(t)\quad\quad\forall\,t.
\end{equation}

\item
We say that the matrix valued field $T:=T(\vec
x,t):\R^3\times[t_0,+\infty)\to\R^{3\times 3}$ is a proper matrix
field if, under every change of coordinate system given by
\er{noninchgravortbstr}, this field transforms by the law:
\begin{equation}\label{uguyytfddddgghjjghjjj}
T'(\vec x',t')=A(t)\cdot T(\vec x,t)\cdot A^T(t)=A(t)\cdot T(\vec
x,t)\cdot \left\{A(t)\right\}^{-1}.
\end{equation}
\end{itemize}
\end{definition}
\begin{proposition}\label{yghgjtgyrtrt}
If $\psi:\R^3\times[t_0,+\infty)\to\R$ is a proper scalar field,
$\vec f:\R^3\times[t_0,+\infty)\to\R^3$ and $\vec
g:\R^3\times[t_0,+\infty)\to\R^3$ are proper vector fields, $\vec
v:\R^3\times[t_0,+\infty)\to\R^3$ and $\vec
u:\R^3\times[t_0,+\infty)\to\R^3$ are speed-like vector fields and
$T:\R^3\times[t_0,+\infty)\to\R^{3\times 3}$ is a proper matrix
field, then:
\begin{itemize}
\item[{\bf(i)}] scalar fields defined in every coordinate system as $\,\vec f\cdot\vec g$, $div_{\vec x} \vec f$ and $div_{\vec x} \vec
v$ are proper scalar fields;

\item[{\bf(ii)}] vector fields defined in every coordinate system as $\nabla_{\vec x}\psi$, $div_{\vec x} T$, $curl_{\vec x}\vec
f$, $\vec f\times\vec g$,
$div_{\vec x}\left(d_{\vec x} \vec v+\left\{d_{\vec x} \vec
v\right\}^T\right)$, $\nabla_{\vec x}\left(div_{\vec x}\vec
v\right)$, $\Delta_{\vec x}\vec v$, $curl_{\vec x}\left(curl_{\vec
x}\vec v\right)$ and $(\vec u-\vec v)$ are proper vector fields;

\item[{\bf(iii)}] matrix fields defined in every coordinate system as $\,d_{\vec x} \vec f$ and $\left(d_{\vec x} \vec v+\left\{d_{\vec x} \vec v\right\}^T\right)$ are proper
matrix fields;

\item[{\bf(iv)}] scalar fields $\xi:\R^3\times[t_0,+\infty)\to\R$ and $\zeta:\R^3\times[t_0,+\infty)\to\R$, defined
in every coordinate system by
\begin{equation}\label{vfyutuyfffhhgfhgfhgtg}
\xi:=\frac{\partial\psi}{\partial t}+\vec v\cdot\nabla_{\vec x}\vec
\psi\quad \text{and}\quad \zeta:=\frac{\partial\psi}{\partial
t}+div_{\vec x}\left\{ \psi\vec v\right\}
\end{equation}
are proper scalar fields;

\item[{\bf(v)}] vector fields $\vec\Theta:\R^3\times[t_0,+\infty)\to\R^3$ and $\vec\Xi:\R^3\times[t_0,+\infty)\to\R^3$, defined
in every coordinate system by
\begin{equation}\label{vfyutuyfffhhgfhgfh}
\vec\Theta:=\frac{\partial \vec f}{\partial t}- curl_{\vec
x}\left(\vec v\times \vec f\right)+\left({div}_{\vec x}\vec
f\right)\vec v\quad\text{and}\quad \vec\Xi:=\frac{\partial \vec
f}{\partial t}- \vec v\times curl_{\vec x}\vec f+\nabla_{\vec
x}\left(\vec v\cdot\vec f\right),
\end{equation}
are proper vector fields and
\begin{equation}
\label{vhfffngghhjghhgjlkhjhkPPP} \vec\Xi
=
\vec\Theta-\left(div_{\vec x}\vec v\right)\vec f+ \left(d_{\vec
x}\vec v+\left\{d_{\vec x}\vec v\right\}^T\right)\cdot\vec f.
\end{equation}
\end{itemize}
\end{proposition}
\begin{proof}
By \er{noninchgravortbstr} and the chain rule for every vector
fields $\vec \Gamma:\R^3\times[t_0,+\infty)\to\R^3$ and $\vec
\Lambda:\R^3\times[t_0,+\infty)\to\R^3$ we have
\begin{equation}
\label{vyguiuiujggghjjgredbstr}
\begin{cases}
\left(A(t)\cdot\vec \Gamma\right)\cdot\left(A(t)\cdot\vec
\Lambda\right)=\vec\Gamma\cdot\vec \Lambda\\
\left(A(t)\cdot\vec \Gamma\right)\times\left(A(t)\cdot\vec
\Lambda\right)=A(t)\cdot\left(\vec\Gamma\times\vec \Lambda\right)
\\
d_{\vec x'}\vec \Gamma=\left(d_{\vec x}\vec \Gamma\right)\cdot A^{-1}(t)\\
curl_{\vec x'}\left( A(t)\cdot\vec \Gamma\right)=A(t)\cdot
curl_{\vec x}
\vec \Gamma\\
div_{\vec x'}\left( A(t)\cdot\vec \Gamma\right)=div_{\vec x}\vec
\Gamma.
\end{cases}
\end{equation}
Thus, in particular, by \er{vyguiuiujggghjjgredbstr} and
\er{uguyytfddddgghjjg} we have
\begin{equation}\label{uguyytfddddgghjjguyyjyhkkyhhjhjh}
\vec f'\cdot\vec g'=\vec f\cdot\vec g,\quad\quad \vec f'\times\vec
g'=A(t)\left(\vec f\times\vec g\right),
\end{equation}
and
\begin{equation}\label{uguyytfddddgghjjguyyjyhkkyh}
div_{\vec x'}\vec f'=div_{\vec x'}\left(A(t)\cdot\vec
f\right)=div_{\vec x}\vec f,
\end{equation}
and by \er{vyguiuiujggghjjgredbstr} and \er{NoIn5redbstr} we have
\begin{multline}
\label{NoIn5redbstrgfhgfhfh} div_{\vec x'}\vec v'=div_{\vec
x'}\left\{A(t)\cdot \vec v+
A'(t)\cdot\vec x+\vec w(t)\right\}=div_{\vec x}\left\{\vec
v+A^{-1}(t)\cdot A'(t)\cdot\vec x+A^{-1}(t)\cdot\vec
w(t)\right\}\\=div_{\vec x}\vec v+tr\left(A^{-1}(t)\cdot
A'(t)\right).
\end{multline}
where $tr\left(A^{-1}(t)\cdot A'(t)\right)$ is the trace of the
matrix $A^{-1}(t)\cdot A'(t)$ (sum of diagonal elements). However,
since $A^T(t)\cdot A(t)=I$ we have $A^{-1}(t)=A^T(t)$ and
$A^{-1}(t)\cdot A'(t)=S(t)$, where $S^T(t)=-S(t)$. In particular
$tr\, S(t)=0$ and thus
\begin{equation}
\label{NoIn5redbstrgfhgfhfhuhhjhjj} tr\,\left(A^{-1}(t)\cdot
A'(t)\right)=0.
\end{equation}
Thus by \er{NoIn5redbstrgfhgfhfh} and
\er{NoIn5redbstrgfhgfhfhuhhjhjj} we have
\begin{equation}
\label{NoIn5redbstrgfhgfhfhgffgf} div_{\vec x'}\vec v'=div_{\vec
x}\vec v.
\end{equation}
So by \er{uguyytfddddgghjjguyyjyhkkyhhjhjh},
\er{uguyytfddddgghjjguyyjyhkkyh} and \er{NoIn5redbstrgfhgfhfhgffgf}
we proved {\bf(i)}.

 Next by \er{vyguiuiujggghjjgredbstr} and \er{uguyytfddddgghjjg} we
have
\begin{equation}\label{uguyytfddddgghjjghjg}
d_{\vec x'}\vec f'=d_{\vec x'}\left(A(t)\cdot\vec f\right)=A(t)\cdot
d_{\vec x'}\vec f=A(t)\cdot\left(d_{\vec x}\vec f\right)\cdot
A^{-1}(t)=A(t)\cdot\left(d_{\vec x}\vec f\right)\cdot A^T(t),
\end{equation}
and by \er{vyguiuiujggghjjgredbstr} and \er{NoIn5redbstr} we have
\begin{multline}
\label{NoIn5redbstruiytrr} d_{\vec x'}\vec v'=d_{\vec
x'}\left(A(t)\cdot \vec v+ A'(t) \cdot\vec x+ \vec
w(t)\right)=A(t)\cdot d_{\vec x'}\vec v+d_{\vec x}\left(A'(t)
\cdot\vec x\right)\cdot A^{-1}(t)\\=A(t)\cdot\left(d_{\vec x}\vec
v\right)\cdot A^{-1}(t)+A'(t)A^{-1}(t)=A(t)\cdot\left(d_{\vec x}\vec
v\right)\cdot A^T(t)+A'(t)\cdot A^T(t).
\end{multline}
Then taking the transpose of the both sides of
\er{NoIn5redbstruiytrr} we infer
\begin{equation}
\label{NoIn5redbstruiytrrhjjggh} \left\{d_{\vec x'}\vec
v'\right\}^T=A(t)\cdot\left\{d_{\vec x}\vec v\right\}^T\cdot
A^T(t)+A(t)\cdot \left\{A'(t)\right\}^T.
\end{equation}
However, as before, since $A(t)\cdot A^T(t)=I$ we have $A'(t)\cdot
A^T(t)+A(t)\cdot \left\{A'(t)\right\}^T=0$, by
\er{NoIn5redbstruiytrr} and \er{NoIn5redbstruiytrrhjjggh} we have
\begin{equation}
\label{NoIn5redbstruiytrrhjjgghjkgkgh} \left(d_{\vec x'}\vec
v'+\left\{d_{\vec x'}\vec v'\right\}^T\right)=A(t)\cdot\left(d_{\vec
x}\vec v+\left\{d_{\vec x}\vec v\right\}^T\right)\cdot A^T(t).
\end{equation}
So by \er{uguyytfddddgghjjghjg} and
\er{NoIn5redbstruiytrrhjjgghjkgkgh} we proved {\bf(iii)}.

 Next by the chain rule and \er{uguyytfdddd} we obtain
\begin{equation}\label{uguyytfddddlhhjlk}
\nabla_{\vec x'}\psi'=\nabla_{\vec
x'}\psi=\left\{A^{-1}(t)\right\}^T\cdot\nabla_{\vec
x}\psi=A(t)\cdot\nabla_{\vec x}\psi,
\end{equation}
by \er{uguyytfddddgghjjg} and \er{vyguiuiujggghjjgredbstr} we obtain
\begin{equation}\label{uguyytfddddgghjjgghghhf}
curl_{\vec x'}\vec f'=curl_{\vec x'}\left(A(t)\cdot\vec
f\right)=A(t)\cdot curl_{\vec x}\vec f,
\end{equation}
and by the chain rule and \er{uguyytfddddgghjjghjjj} we have
\begin{equation}\label{uguyytfddddgghjjghjjjjhghghghjjff}
div_{\vec x'}T'=div_{\vec x'}\left(A(t)\cdot T\cdot
A^T(t)\right)=A(t)\cdot\left(div_{\vec x}T\right).
\end{equation}
Thus by \er{uguyytfddddgghjjghjjjjhghghghjjff} and
\er{NoIn5redbstruiytrrhjjgghjkgkgh} we have
\begin{equation}
\label{NoIn5redbstruiytrrhjjgghjkgkghfhfddgh} div_{\vec
x'}\left(d_{\vec x'}\vec v'+\left\{d_{\vec x'}\vec
v'\right\}^T\right)=A(t)\cdot\left\{div_{\vec x}\left(d_{\vec x}\vec
v+\left\{d_{\vec x}\vec v\right\}^T\right)\right\}.
\end{equation}
On the other hand by \er{uguyytfddddgghjjguyyjyhkkyh} and
\er{uguyytfddddlhhjlk} we have
\begin{equation}
\label{NoIn5redbstrgfhgfhfhgffgfghgjhjgjjj} \nabla_{\vec
x'}\left(div_{\vec x'}\vec v'\right)=A(t)\cdot\nabla_{\vec
x}\left(div_{\vec x}\vec v\right).
\end{equation}
Therefore, by \er{NoIn5redbstruiytrrhjjgghjkgkghfhfddgh} and
\er{NoIn5redbstrgfhgfhfhgffgfghgjhjgjjj}, using \er{apfrm12} we
deduce
\begin{equation}
\label{NoIn5redbstrgfhgfhfhgffgfghgjhjgjjjghjjjf} \Delta_{\vec
x'}\vec v'=A(t)\cdot\Delta_{\vec x}\vec v \quad\text{and}\quad
curl_{\vec x'}\left(curl_{\vec x'}\vec v'\right)=A(t)\cdot
curl_{\vec x}\left(curl_{\vec x}\vec v\right).
\end{equation}
Next by \er{NoIn5redbstr} we deduce
\begin{equation}\label{hjhkujgjgyjfhijkyuiu}
(\vec u'-\vec v')=A(t)\cdot(\vec u-\vec v).
\end{equation}
So by \er{uguyytfddddgghjjguyyjyhkkyhhjhjh}, \er{uguyytfddddlhhjlk},
\er{uguyytfddddgghjjgghghhf},
\er{uguyytfddddgghjjghjjjjhghghghjjff},
\er{NoIn5redbstruiytrrhjjgghjkgkghfhfddgh},
\er{NoIn5redbstrgfhgfhfhgffgfghgjhjgjjj},
\er{NoIn5redbstrgfhgfhfhgffgfghgjhjgjjjghjjjf} and
\er{hjhkujgjgyjfhijkyuiu} we deduce {\bf(ii)}.

Furthermore, by the chain rule for every scalar field
$\gamma:\R^3\times[t_0,+\infty)\to\R$ and for every vector field
$\vec \Gamma:\R^3\times[t_0,+\infty)\to\R^3$ we obtain
\begin{equation}\label{hjhkujgjgyjfhijk}
\frac{\partial \vec \gamma}{\partial t}=\frac{\partial \vec
\gamma}{\partial t'}+\left(A'(t)\cdot\vec x+\vec
w(t)\right)\cdot\nabla_{\vec x'}\vec \gamma
\end{equation}
and
\begin{equation}\label{hjhkujgjgyjf}
\frac{\partial \vec \Gamma}{\partial t}=\frac{\partial \vec
\Gamma}{\partial t'}+\left(d_{\vec x'}\vec
\Gamma\right)\cdot\left(A'(t)\cdot\vec x+\vec w(t)\right).
\end{equation}
Therefore, by \er{hjhkujgjgyjf} and \er{vyguiuiujggghjjgredbstr}
\begin{equation}
\label{vyguiuiujggghjjgggjredbstr} \frac{\partial \vec
\Gamma}{\partial t'}=\frac{\partial \vec \Gamma}{\partial
t}-\left(d_{\vec x}\vec \Gamma\right)\cdot \left(A^{-1}(t)\cdot
A'(t)\cdot\vec x+A^{-1}(t)\cdot \vec w(t)\right),
\end{equation}
and by \er{vyguiuiujggghjjgredbstr} \er{uguyytfddddlhhjlk} and
\er{hjhkujgjgyjfhijk}
\begin{equation}\label{hjhkujgjgyjfhijkjkhkggjggj}
\frac{\partial \vec \gamma}{\partial t'}+\left(A(t)\cdot\vec
\Gamma+A'(t)\cdot\vec x+\vec w(t)\right)\cdot\nabla_{\vec x'}\vec
\gamma=\frac{\partial \vec \gamma}{\partial
t}+\vec\Gamma\cdot\nabla_{\vec x}\gamma.
\end{equation}
In particular, by \er{uguyytfdddd}, \er{NoIn5redbstr} and
\er{hjhkujgjgyjfhijkjkhkggjggj} we have
\begin{equation}\label{hjhkujgjgyjfhijkjkhkggjggjugiu}
\frac{\partial \vec \psi}{\partial t'}+\vec v'\cdot\nabla_{\vec
x'}\vec \psi=\frac{\partial \vec \psi}{\partial t}+\vec
v\cdot\nabla_{\vec x}\psi
\end{equation}
and then since
\begin{equation}\label{hjhkujgjgyjfhijkjkhkggjggjugiujghj}
\frac{\partial \vec \psi}{\partial t}+div_{\vec x}\left\{\psi\vec
v\right\}=\frac{\partial \vec \psi}{\partial t}+\vec
v\cdot\nabla_{\vec x}\psi+\psi\left(div_{\vec x}\vec v\right),
\end{equation}
by \er{hjhkujgjgyjfhijkjkhkggjggjugiu}, \er{uguyytfdddd} and
\er{NoIn5redbstrgfhgfhfhgffgf} we infer {\bf (iv)}. On the other
hand,
%
%
%
%
%
%
by \er{vyguiuiujggghjjgredbstr}, \er{vyguiuiujggghjjgggjredbstr} and
\er{NoIn5redbstr}
for every vector field $\vec \Gamma:\R^3\times[t_0,+\infty)\to\R^3$
we get:
\begin{multline}\label{bgvfgfhhtgjtggguyfredbstr}
\frac{\partial \left(A(t)\cdot\vec \Gamma\right)}{\partial
t'}-curl_{\vec x'}\left(\vec v'\times\left(A(t)\cdot\vec
\Gamma\right)\right)+\left({div}_{\vec x'}\left(A(t)\cdot\vec
\Gamma\right)\right)\vec v'=\\
\left(A(t)\cdot\frac{\partial \vec \Gamma}{\partial
t}+A'(t)\cdot\vec \Gamma-A(t)\cdot \left(d_{\vec x}\vec
\Gamma\right)\cdot \left(A^{-1}(t)\cdot A'(t)\cdot\vec
x+A^{-1}(t)\cdot \vec w(t)\right)\right)\\-A(t)\cdot curl_{\vec
x}\left(\left(\vec v+A^{-1}(t)\cdot A'(t)\cdot\vec
x+A^{-1}(t)\cdot\vec w(t)\right)\times
\vec\Gamma\right)\\+\left({div}_{\vec x}\vec
\Gamma\right)\left(A(t)\cdot\vec v+A'(t)\cdot\vec x+\vec
w(t)\right)\\= A(t)\cdot\left(\frac{\partial \vec \Gamma}{\partial
t}- curl_{\vec x}\left(\vec v\times
\vec\Gamma\right)+\left({div}_{\vec x}\vec \Gamma\right)\vec
v\right)\\+A(t)\cdot \left(d_{\vec x}\left(A^{-1}(t)\cdot
A'(t)\cdot\vec x+A^{-1}(t)\cdot\vec w(t)\right)\right)\cdot\vec
\Gamma\\-A(t)\cdot \left(d_{\vec x}\vec \Gamma\right)\cdot
\left(A^{-1}(t)\cdot A'(t)\cdot\vec x+A^{-1}(t)\cdot \vec
w(t)\right)\\+A(t)\cdot\left(\left({div}_{\vec x}\vec
\Gamma\right)\left(A^{-1}(t)\cdot A'(t)\cdot\vec
x+A^{-1}(t)\cdot\vec w(t)\right)\right)\\-A(t)\cdot curl_{\vec
x}\left(\left(A^{-1}(t)\cdot A'(t)\cdot\vec x+A^{-1}(t)\cdot\vec
w(t)\right)\times \vec\Gamma\right).
\end{multline}
On the other hand,  by \er{apfrm6} we have,
\begin{multline}\label{bgvfgfhhtgjtgjuiyuijjkredbstr}
\left(d_{\vec x}\left(A^{-1}(t)\cdot A'(t)\cdot\vec
x+A^{-1}(t)\cdot\vec w(t)\right)\right)\cdot\vec
\Gamma\\-\left(d_{\vec x}\vec \Gamma\right)\cdot
\left(A^{-1}(t)\cdot A'(t)\cdot\vec x+A^{-1}(t)\cdot \vec
w(t)\right)\\+\left({div}_{\vec x}\vec
\Gamma\right)\left(A^{-1}(t)\cdot A'(t)\cdot\vec
x+A^{-1}(t)\cdot\vec w(t)\right)\\- curl_{\vec
x}\left(\left(A^{-1}(t)\cdot A'(t)\cdot\vec x+A^{-1}(t)\cdot\vec
w(t)\right)\times \vec\Gamma\right)\\=\left({div}_{\vec
x}\left(A^{-1}(t)\cdot A'(t)\cdot\vec x+A^{-1}(t)\cdot\vec
w(t)\right)\right)\vec \Gamma.
\end{multline}
Therefore, by \er{bgvfgfhhtgjtggguyfredbstr} and
\er{bgvfgfhhtgjtgjuiyuijjkredbstr} we deduce:
\begin{multline}\label{bgvfgfhhtgjtgredbstr}
\frac{\partial \left(A(t)\cdot\vec \Gamma\right)}{\partial
t'}-curl_{\vec x'}\left(\vec v'\times\left(A(t)\cdot\vec
\Gamma\right)\right)+\left({div}_{\vec x'}\left(A(t)\cdot\vec
\Gamma\right)\right)\vec v'=\\ A(t)\cdot\left(\frac{\partial \vec
\Gamma}{\partial t}- curl_{\vec x}\left(\vec v\times
\vec\Gamma\right)+\left({div}_{\vec x}\vec \Gamma\right)\vec
v\right)\\+A(t)\cdot\left(\left({div}_{\vec x}\left(A^{-1}(t)\cdot
A'(t)\cdot\vec x+A^{-1}(t)\cdot\vec w(t)\right)\right)\vec
\Gamma\right)
\\=A(t)\cdot\left(\frac{\partial \vec \Gamma}{\partial t}- curl_{\vec
x}\left(\vec v\times \vec\Gamma\right)+\left({div}_{\vec x}\vec
\Gamma\right)\vec v\right)+\left(tr\left(A^{-1}(t)\cdot
A'(t)\right)\right)A(t)\cdot\vec \Gamma,
\end{multline}
where $tr\left(A^{-1}(t)\cdot A'(t)\right)$ is the trace of the
matrix $A^{-1}(t)\cdot A'(t)$.
Therefore, by \er{bgvfgfhhtgjtgredbstr} and
\er{NoIn5redbstrgfhgfhfhuhhjhjj} for every vector field $\vec
\Gamma:\R^3\times[t_0,+\infty)\to\R^3$ we have:
\begin{multline}\label{bgvfgfhhtgjtggiuguiuiredbstr}
\frac{\partial \left(A(t)\cdot\vec \Gamma\right)}{\partial
t'}-curl_{\vec x'}\left(\vec v'\times\left(A(t)\cdot\vec
\Gamma\right)\right)+\left({div}_{\vec x'}\left(A(t)\cdot\vec
\Gamma\right)\right)\vec v'\\=A(t)\cdot\left(\frac{\partial \vec
\Gamma}{\partial t}- curl_{\vec x}\left(\vec v\times
\vec\Gamma\right)+\left({div}_{\vec x}\vec \Gamma\right)\vec
v\right).
\end{multline}
Thus, by \er{bgvfgfhhtgjtggiuguiuiredbstr} and
\er{uguyytfddddgghjjg} we infer
\begin{equation}\label{bgvfgfhhtgjtggiuguiuiredbstrkgghkj}
\frac{\partial \vec f'}{\partial t'}-curl_{\vec x'}\left(\vec
v'\times\vec f'\right)+\left({div}_{\vec x'}\vec f'\right)\vec
v'=A(t)\cdot\left(\frac{\partial \vec f}{\partial t}- curl_{\vec
x}\left(\vec v\times \vec f\right)+\left({div}_{\vec x}\vec
f\right)\vec v\right).
\end{equation}
Finally, by \er{apfrm9}, \er{apfrm7} and \er{apfrm6} we deduce
\begin{multline}\label{vhfffngghhjghhgjlkhjhk}
\frac{\partial\vec f}{\partial t}-\vec v\times curl_{\vec x}\vec
f+\nabla_{\vec x}\left(\vec f\cdot\vec v\right)=\nabla_{\vec
x}\left(\vec f\cdot\vec v\right)+\frac{\partial\vec f}{\partial
t}+d_{\vec x}\vec f\cdot\vec v-\left\{d_{\vec x}\vec
f\right\}^T\cdot\vec v\\=\frac{\partial\vec f}{\partial t}+d_{\vec
x}\vec f\cdot\vec v+ \left\{d_{\vec x}\vec v\right\}^T\cdot\vec
f=\frac{\partial\vec f}{\partial t}+d_{\vec x}\vec f\cdot\vec
v-d_{\vec x}\vec v\cdot\vec f+\left(d_{\vec x}\vec v+\left\{d_{\vec
x}\vec v\right\}^T\right)\cdot\vec f\\= \left(\frac{\partial\vec
f}{\partial t}-curl_{\vec x}\left(\vec v\times\vec
f\right)+\left(div_{\vec x}\vec f\right)\vec
v\right)-\left(div_{\vec x}\vec v\right)\vec f+ \left(d_{\vec x}\vec
v+\left\{d_{\vec x}\vec v\right\}^T\right)\cdot\vec f.
\end{multline}
So we get \er{vhfffngghhjghhgjlkhjhkPPP}. Moreover, by
\er{uguyytfddddgghjjg}, \er{NoIn5redbstrgfhgfhfhgffgf},
\er{NoIn5redbstruiytrrhjjgghjkgkgh}, \er{vhfffngghhjghhgjlkhjhk} and
\er{bgvfgfhhtgjtggiuguiuiredbstrkgghkj} we infer
\begin{equation}\label{vhfffngghhjghhgjlkhjhkfghhffhhhhg}
\frac{\partial\vec f'}{\partial t'}-\vec v'\times curl_{\vec x'}\vec
f'+\nabla_{\vec x'}\left(\vec f'\cdot\vec
v'\right)=A(t)\cdot\left(\frac{\partial\vec f}{\partial t}-\vec
v\times curl_{\vec x}\vec f+\nabla_{\vec x}\left(\vec f\cdot\vec
v\right)\right).
\end{equation}
So by \er{bgvfgfhhtgjtggiuguiuiredbstrkgghkj} and
\er{vhfffngghhjghhgjlkhjhkfghhffhhhhg} we finally obtain {\bf (v)}.
\end{proof}

\begin{proposition}\label{yghgjtgyrtrgjgjgjgjgjgjgjgjgjgjgjgjgj33}
Consider some fixed inertial or non-inertial cartesian coordinate
system, denoted by the sign $(\{0\})$, and consider a vector field
$\vec d:=\vec d(\vec x,t):\R^3\times[t_0,+\infty)\to\R^3$ associated
\underline{only} with the chosen coordinate system $(\{0\})$. Then
there exist a unique \underline{speed-like} vector field $\vec
v_{\vec d}:\R^3\times[t_0,+\infty)\to\R^{3}$ (associated with any
inertial or non-inertial cartesian coordinate system) such that in
the \underline{particular} system $(\{0\})$ we have:
\begin{equation}\label{vfyutuyfffhhgfhgfh33iy33}
\vec v_{\vec d}(\vec x,t)=\vec d(\vec x,t)\quad\quad\forall(\vec
x,t) \in\R^3\times[t_0,+\infty)\,.
\end{equation}
Moreover, there exist unique a \underline{proper} vector field $\vec
h_{\vec d}:\R^3\times[t_0,+\infty)\to\R^3$ (associated with any
inertial or non-inertial cartesian coordinate system) such that in
the \underline{particular} system $(\{0\})$ we have:
\begin{equation}\label{vfyutuyfffhhgfhgfh33iyjkkj33}
\vec h_{\vec d}(\vec x,t)=\vec d(\vec x,t)\quad\quad\forall(\vec
x,t) \in\R^3\times[t_0,+\infty)\,.
\end{equation}
\end{proposition}
\begin{proof}
Since uniqueness is trivial, we concentrate ourselves on existence.
Similarly to \er{noninchgravortbstr}, given an arbitrary inertial or
non-inertial cartesian coordinate system, denoted by the sign
$(\{1\})$, the change of coordinate system $(\{0\})$ to the system
$(\{1\})$ can be written as:
\begin{equation}\label{noninchgravortbstrjkjkbbq133}
\begin{cases}
\vec x^{(1)}=B_1(t)\cdot\vec x+\vec z_1(t),\\
t^{(1)}=t,
\end{cases}
\end{equation}
where $B_1(t)\in SO(3)$ is a rotation, i.e. $B_1(t)\in \R^{3\times
3}$, $det\, B_1(t)>0$ and $B_1(t)\cdot B^T_1(t)=I$. Similarly, given
another arbitrary inertial or non-inertial cartesian coordinate
system, denoted by the sign $(\{2\})$, the change of coordinate
system $(\{0\})$ to the system $(\{2\})$ can be written as:
\begin{equation}\label{noninchgravortbstrjkjkbbq233}
\begin{cases}
\vec x^{(2)}=B_2(t)\cdot\vec x+\vec z_2(t),\\
t^{(2)}=t,
\end{cases}
\end{equation}
where $B_2(t)\in SO(3)$ is another rotation. On the other hand, the
change coordinate system $(\{1\})$ to the system $(\{2\})$ can be
written as:
\begin{equation}\label{noninchgravortbstrjkjkjggg33}
\begin{cases}
\vec x^{(2)}=A(t^{(1)})\cdot\vec x^{(1)}+\vec z(t^{(1)}),\\
t^{(2)}=t^{(1)},
\end{cases}
\end{equation}
where $A(t)\in SO(3)$ is a rotation. In particular, inserting
\er{noninchgravortbstrjkjkbbq133} into
\er{noninchgravortbstrjkjkjggg33} gives
\begin{equation}\label{noninchgravortbstrjkjkjggg33hkjh33}
\vec x^{(2)}=A(t)\cdot\left(B_1(t)\cdot\vec x+\vec
z_1(t)\right)+\vec z(t)=\left(A(t)\cdot B_1(t)\right)\cdot\vec
x+\left(A(t)\cdot\vec z_1(t)+\vec z(t)\right).
\end{equation}
Thus, comparing \eqref{noninchgravortbstrjkjkjggg33hkjh33} with
\er{noninchgravortbstrjkjkbbq233} we easily deduce:
\begin{equation}\label{noninchgravortbstrjkjkjggg33hkjh33jklhhk33}
B_2(t)=\left(A(t)\cdot
B_1(t)\right)\quad\quad\text{and}\quad\quad\vec
z_2(t)=\left(A(t)\cdot\vec z_1(t)+\vec
z(t)\right)\quad\quad\quad\forall t.
\end{equation}

Next, for the system $(\{1\})$ define
\begin{equation}
\label{NoIn5redbstrhkhhk133}\vec v^{(1)}_{\vec d}(\vec
x^{(1)},t^{(1)})=B_1(t)\cdot \vec d(\vec x,t)+
\frac{d B_1}{dt}(t)\cdot\vec x+
\frac{d\vec z_1}{dt}(t),
\end{equation}
and
\begin{equation}\label{uguyytfddddgghjjgjhjhjh133}
\vec h^{(1)}_{\vec d}(\vec x^{(1)},t^{(1)})=B_1(t)\cdot\vec d(\vec
x,t).
\end{equation}
Correspondingly, for the system $(\{2\})$ we have
\begin{equation}
\label{NoIn5redbstrhkhhk233}\vec v^{(2)}_{\vec d}(\vec
x^{(2)},t^{(2)})=B_2(t)\cdot \vec d(\vec x,t)+
\frac{d B_2}{dt}(t)\cdot\vec x+
\frac{d\vec z_2}{dt}(t),
\end{equation}
and
\begin{equation}\label{uguyytfddddgghjjgjhjhjh233}
\vec h^{(2)}_{\vec d}(\vec x^{(2)},t^{(2)})=B_2(t)\cdot\vec d(\vec
x,t).
\end{equation}
Thus, in the \underline{particular} system $(\{0\})$ we have:
\begin{equation}\label{vfyutuyfffhhgfhgfh33iy33hkjhjh33}
\vec v_{\vec d}(\vec x,t)=\vec h_{\vec d}(\vec x,t)=\vec d(\vec
x,t)\quad\quad\forall(\vec x,t) \in\R^3\times[t_0,+\infty)\,.
\end{equation}
Thus in order to complete the proof we need to show that the vector
field, defined by \er{NoIn5redbstrhkhhk133} and
\er{NoIn5redbstrhkhhk233} is a speed-like vector field and the
vector field, defined by \er{uguyytfddddgghjjgjhjhjh133} and
\er{uguyytfddddgghjjgjhjhjh233} is a proper vector field. Indeed, by
\er{uguyytfddddgghjjgjhjhjh133} and \er{uguyytfddddgghjjgjhjhjh233}
together with \er{noninchgravortbstrjkjkjggg33hkjh33jklhhk33} and
\er{noninchgravortbstrjkjkbbq133} we easily deduce:
\begin{equation}\label{uguyytfddddgghjjgjhjhjh233hkjhjh33}
\vec h^{(2)}_{\vec d}(\vec x^{(2)},t^{(2)})=\left(A(t)\cdot
B_1(t)\right)\cdot\vec d(\vec x,t)=A(t)\cdot\left(B_1(t)\cdot\vec
d(\vec x,t)\right)=A(t^{(1)})\cdot\vec h^{(1)}_{\vec d}(\vec
x^{(1)},t^{(1)}),
\end{equation}
and so, comparing \er{uguyytfddddgghjjgjhjhjh233hkjhjh33} with
\er{uguyytfddddgghjjg} in Definition \eqref{bggghghgj}, we deduce
that $\vec h_{\vec d}$ is a \underline{proper} vector field.
Similarly,
 by \er{NoIn5redbstrhkhhk133} and
\er{NoIn5redbstrhkhhk233} together with
\er{noninchgravortbstrjkjkjggg33hkjh33jklhhk33} and
\er{noninchgravortbstrjkjkbbq133} we deduce:
\begin{multline}
\label{NoIn5redbstrhkhhk233jhgjgj33}\vec v^{(2)}_{\vec d}(\vec
x^{(2)},t^{(2)})=B_2(t)\cdot \vec d(\vec x,t)+
\frac{d B_2}{dt}(t)\cdot\vec x+
\frac{d\vec z_2}{dt}(t)=
\\
\left(A(t)\cdot B_1(t)\right)\cdot \vec d(\vec x,t)+
\left(\frac{d}{dt}\left\{A(t)\cdot B_1(t)\right\}\right)\cdot\vec x+
\frac{d}{dt}\left\{A(t)\cdot\vec z_1(t)+\vec z(t)\right\}=
\\ A(t)\cdot
\left(B_1(t)\cdot \vec d(\vec x,t)\right)+ \left(A(t)\cdot
\frac{dB_1}{dt}(t)+\frac{dA}{dt}(t)\cdot B_1(t)\right)\cdot\vec
x+\left\{A(t)\cdot\frac{d\vec z_1}{dt}(t)+\frac{dA}{dt}(t)\cdot\vec
z_1(t)+\frac{d\vec z}{dt}(t)\right\}\\=
 A(t)\cdot
\left(B_1(t)\cdot \vec d(\vec x,t)+\frac{dB_1}{dt}(t)\cdot\vec
x+\frac{d\vec z_1}{dt}(t)\right)+\frac{dA}{dt}(t)\cdot\left(
B_1(t)\cdot\vec x+\vec z_1(t)\right)+\frac{d\vec z}{dt}(t)\\= \vec
A(t^{(1)})\cdot \vec v^{(1)}_{\vec d}(\vec
x^{(1)},t^{(1)})+\frac{dA}{dt^{(1)}}(t^{(1)})\cdot\vec
x^{(1)}+\frac{d\vec z}{dt^{(1)}}(t^{(1)}),
\end{multline}
and so, comparing \er{NoIn5redbstrhkhhk233jhgjgj33} with
\er{NoIn5redbstr} in Definition \ref{bggghghgj}, we deduce that
$\vec v_{\vec d}$ is a \underline{speed-like} vector field. This
completes the proof.
\end{proof}

\begin{definition}\label{jhjkhhjhjhgjg}
Let $\vec u:=\vec u(\vec x,t)$ be a speed-like vector field defined
for $(\vec x,t)\in\Omega$ with domain of definition
$\Omega\subset\mathbb{R}^3\times\mathbb{R}$ and let $(*)$ be some
inertial or non-inertial cartesian coordinate system. Then $\vec u$
is called \underline{generally trivial speed-like field} if there
exists another cartesian coordinate system $(**)$ such that under
the change of coordinate system $(*)$ to another cartesian
coordinate system $(**)$ given by:
\begin{equation}\label{noninchgravortbstrjgghguittu2hhhhh}
\begin{cases}
\vec x'=A(t)\cdot\vec x+\vec z(t),\\
t'=t,
\end{cases}
\end{equation}
where $A(t)\in SO(3)$ is a rotation, we have
\begin{equation}
\label{jljjjjkhhh}A(t)\cdot \vec u(\vec x,t)+
\frac{d A}{dt}(t)\cdot\vec x+
\frac{d\vec z}{dt}(t)=\vec u'(\vec x',t')=0.
\end{equation}
\end{definition}
\begin{proposition}
\label{jhjkhhjhjhgjgjjkjk} Let $\vec u:=\vec u(\vec x,t)$ be a
smooth speed-like vector field defined for $(\vec x,t)\in\Omega$
with connected domain $\Omega\subset\mathbb{R}^3\times\mathbb{R}$,
such that for every instant of time $\tau$ the domain
$\Omega_\tau=\left\{\vec x\in \mathbb{R}^3:\,(\vec
x,\tau)\in\Omega\right\}$ is also connected. Then $\vec u$ is a
\underline{generally trivial speed-like field} if and only if we
have
\begin{equation}
\label{jljjjjkhhhjhjhjh} d_{\vec x}\vec u+\left\{d_{\vec x}\vec
u\right\}^T=0\quad\quad\forall (\vec x,t)\in\Omega.
\end{equation}
\end{proposition}
\begin{proof}
Assume that \er{jljjjjkhhhjhjhjh} is satisfied and denote
$(u_1,u_2,u_3)=\vec u\in\mathbb{R}^3$ and $(x_1,x_2,x_3)=\vec
x\in\mathbb{R}^3$. Then \er{jljjjjkhhhjhjhjh} is equivalent to
\begin{equation}
\label{jljjjjkhhhjhjhjhjgjg} \frac{\partial u_m}{\partial
x_n}+\frac{\partial u_n}{\partial x_m}=0\quad\quad\forall (\vec
x,t)\in\Omega,\;\;\forall m,n=1,2,3.
\end{equation}
One more time differentiating of \er{jljjjjkhhhjhjhjhjgjg} gives
\begin{equation}
\label{jljjjjkhhhjhjhjhjgjgbngjgjhjhhj} \frac{\partial^2
u_m}{\partial x_n\partial x_k}+\frac{\partial^2 u_n}{\partial
x_m\partial x_k}=0\quad\quad\forall (\vec x,t)\in\Omega,\;\;\forall
m,n,k=1,2,3.
\end{equation}
Therefore,
\begin{multline}
\label{jljjjjkhhhjhjhjhjgjgbngjgjhjhhjhhh} 2\frac{\partial^2
u_n}{\partial x_m\partial x_k}=\left(\frac{\partial^2 u_m}{\partial
x_n\partial x_k}+\frac{\partial^2 u_n}{\partial x_m\partial
x_k}\right)-\left(\frac{\partial^2 u_m}{\partial x_k\partial
x_n}+\frac{\partial^2 u_k}{\partial x_m\partial
x_n}\right)\\+\left(\frac{\partial^2 u_k}{\partial x_n\partial
x_m}+\frac{\partial^2 u_n}{\partial x_k\partial
x_m}\right)=0\quad\quad\forall (\vec x,t)\in\Omega,\;\;\forall
m,n,k=1,2,3.
\end{multline}
So,
\begin{equation}
\label{jljjjjkhhhjhjhjhjgjgbngjgjhjhhjkjkhhhjhj}
\frac{\partial}{\partial x_k}\left(\frac{\partial u_n}{\partial
x_m}\right)=\frac{\partial^2 u_n}{\partial x_m\partial
x_k}=0\quad\quad\forall (\vec x,t)\in\Omega,\;\;\forall m,n,k=1,2,3.
\end{equation}
Thus since, for every instant of time $t$ the domain
$\Omega_t\subset \mathbb{R}^3$ is connected, by
\er{jljjjjkhhhjhjhjhjgjgbngjgjhjhhjkjkhhhjhj} we deduce that
$d_{\vec x}\vec u$ is independent of the spatial variable $\vec x$
and can depend only on the time variable $t$. So by
\er{jljjjjkhhhjhjhjhjgjgbngjgjhjhhjkjkhhhjhj} and
\er{jljjjjkhhhjhjhjhjgjg} together we deduce that there exists an
antisymmetric matrix $B(t)\in\mathbb{R}^{3\times 3}$, depending on
the time variable $t$ only, such that
\begin{equation}
\label{jljjjjkhhhjhjhjhjgjgbngjgjhjhhjkjkhhhjhjggjhjhhjgjg} d_{\vec
x}\vec u=B(t)\quad\forall (\vec
x,t)\in\Omega\quad\quad\text{where}\quad B^T(t)=-B(t).
\end{equation}
Thus there exists a regular mapping $\vec
d(t):\mathbb{R}\to\mathbb{R}^3$ such that
\begin{equation}
\label{jljjjjkhhhjhjhjhjgjgbngjgjhjhhjkjkhhhjhjggjhjhhjgjghhjhjhjjh}
\vec u(\vec x,t)=B(t)\cdot \vec x+\vec d(t)\quad\forall (\vec
x,t)\in\Omega\quad\quad\text{where}\quad B^T(t)=-B(t).
\end{equation}
Next let
%
%
%
%
%
%
%
%
$\vec h(t)\in\mathbb{R}^3$ be a solution of
\begin{equation}
\label{jljjjjkhhhjhjhjhjgjgbngjgjhjhhjkjkhhhjhjggjhjhhjgjghhjhjhjjhgjgggjhhhjkhhkhhhkh}
\begin{cases}
\frac{d\vec h}{dt}(t)=B(t)\cdot\vec h(t)+\vec d(t)\quad\quad\forall
t
\\
\vec h(0)=0,
\end{cases}
\end{equation}
and $R(t)\in \mathbb{R}^{3\times 3}$ be a matrix-valued solution of
\begin{equation}
\label{jljjjjkhhhjhjhjhjgjgbngjgjhjhhjkjkhhhjhjggjhjhhjgjghhjhjhjjhgjgggjhhhjkhhkhhhkhjkjkhkhk}
\begin{cases}
\frac{d R}{dt}(t)=B(t)\cdot R(t)\quad\quad\forall t
\\
R(0)=I,
\end{cases}
\end{equation}
where $I$ is the identity matrix. In particular,
\begin{equation}
\label{jljjjjkhhhjhjhjhjgjgbngjgjhjhhjkjkhhhjhjggjhjhhjgjghhjhjhjjhgjgggjhhhjkhhkhhhkhjkjkhkhkhjjgjhhj}
R^T(t)\cdot\frac{d R}{dt}(t)=R^T(t)\cdot B(t)\cdot
R(t)\quad\quad\forall t.
\end{equation}
On the other hand, by
\er{jljjjjkhhhjhjhjhjgjgbngjgjhjhhjkjkhhhjhjggjhjhhjgjghhjhjhjjhgjgggjhhhjkhhkhhhkhjkjkhkhk}
and
\er{jljjjjkhhhjhjhjhjgjgbngjgjhjhhjkjkhhhjhjggjhjhhjgjghhjhjhjjh} we
have,
\begin{equation}
\label{jljjjjkhhhjhjhjhjgjgbngjgjhjhhjkjkhhhjhjggjhjhhjgjghhjhjhjjhgjgggjhhhjkhhkhhhkhjkjkhkhkjgkuuijhhkoilj}
\frac{d R^T}{dt}(t)=R^T(t)\cdot B^T(t)=-R^T(t)\cdot
B(t)\quad\quad\forall t,
\end{equation}
and thus
\begin{equation}
\label{jljjjjkhhhjhjhjhjgjgbngjgjhjhhjkjkhhhjhjggjhjhhjgjghhjhjhjjhgjgggjhhhjkhhkhhhkhjkjkhkhkjgkuuijhhkoiljhhhhjjhjjk}
\frac{d R^T}{dt}(t)\cdot R(t)=-R^T(t)\cdot B(t)\cdot
R(t)\quad\quad\forall t.
\end{equation}
Thus, by
\er{jljjjjkhhhjhjhjhjgjgbngjgjhjhhjkjkhhhjhjggjhjhhjgjghhjhjhjjhgjgggjhhhjkhhkhhhkhjkjkhkhkhjjgjhhj}
and
\er{jljjjjkhhhjhjhjhjgjgbngjgjhjhhjkjkhhhjhjggjhjhhjgjghhjhjhjjhgjgggjhhhjkhhkhhhkhjkjkhkhkjgkuuijhhkoiljhhhhjjhjjk}
together we obtain
\begin{equation}
\label{jljjjjkhhhjhjhjhjgjgbngjgjhjhhjkjkhhhjhjggjhjhhjgjghhjhjhjjhgjgggjhhhjkhhkhhhkhjkjkhkhkjgkuuijhhkoiljhhhhjjhjjkhkhkhkh}
\frac{d}{dt}\left(R^T(t)\cdot R(t)\right)=R^T(t)\cdot\frac{d
R}{dt}(t)+\frac{d R^T}{dt}(t)\cdot R(t)=0\quad\quad\forall t.
\end{equation}
Therefore, by
\er{jljjjjkhhhjhjhjhjgjgbngjgjhjhhjkjkhhhjhjggjhjhhjgjghhjhjhjjhgjgggjhhhjkhhkhhhkhjkjkhkhkjgkuuijhhkoiljhhhhjjhjjkhkhkhkh}
and
\er{jljjjjkhhhjhjhjhjgjgbngjgjhjhhjkjkhhhjhjggjhjhhjgjghhjhjhjjhgjgggjhhhjkhhkhhhkhjkjkhkhk}
we obtain
\begin{equation}
\label{jljjjjkhhhjhjhjhjgjgbngjgjhjhhjkjkhhhjhjggjhjhhjgjghhjhjhjjhgjgggjhhhjkhhkhhhkhjkjkhkhkjgkuuijhhkoiljhhhhjjhjjkhkhkhkhjkhhjgjhhhhj}
R^T(t)\cdot R(t)=R^T(0)\cdot R(0)=I^T\cdot I=I\quad\quad\forall t.
\end{equation}
In particular, we have $\left(det\,R(t)\right)^2=1$ and thus, since
$det\,R(0)=1$, we deduce $det\,R(t)=1$. So we obtain
\begin{equation}
\label{jljjjjkhhhjhjhjhjgjgbngjgjhjhhjkjkhhhjhjggjhjhhjgjghhjhjhjjhgjgggjhhhjkhhkhhhkhjkjkhkhkjgkuuijhhkoiljhhhhjjhjjkhkhkhkhjkhhjgjhhhhjgjkllukbbn}
R(t)\in SO(3)\quad\quad\forall t.
\end{equation}
Next consider $A(t)\in SO(3)$ and $\vec z(t)\in\mathbb{R}^3$ defined
by
\begin{equation}
\label{jljjjjkhhhjhjhjhjgjgbngjgjhjhhjkjkhhhjhjggjhjhhjgjghhjhjhjjhgjgggjhhhjkhhkhhhkhjkjkhkhkjgkuuijhhkoiljhhhhjjhjjkhkhkhkhjkhhjgjhhhhjgjkllukbbnjhhjkjk}
A(t)=R^T(t)=R^{-1}(t)\quad\text{and}\quad \vec z(t)=-R^T(t)\cdot\vec
h(t)\quad\quad\forall t,
\end{equation}
and consider the change of cartesian coordinate system given by:
\begin{equation}
\label{jljjjjkhhhjhjhjhjgjgbngjgjhjhhjkjkhhhjhjggjhjhhjgjghhjhjhjjhhkhhhhhiukk}
\begin{cases}\vec x'=A(t)\cdot\vec x+\vec z(t)=R^T(t)\cdot\vec
x-R^T(t)\cdot\vec h(t)
\\
t'=t.
\end{cases}
\end{equation}
Then since the vector field  $\vec u$ is a speed-like vector field,
under the change of coordinate system, given by
\er{jljjjjkhhhjhjhjhjgjgbngjgjhjhhjkjkhhhjhjggjhjhhjgjghhjhjhjjhhkhhhhhiukk},
$\vec u$ transforms as
\begin{equation}
\label{jljjjjkhhhgyggtt}\vec u'(\vec x',t')=A(t)\cdot \vec u(\vec
x,t)+
\frac{d A}{dt}(t)\cdot\vec x+
\frac{d\vec z}{dt}(t).
\end{equation}
Therefore, inserting
\er{jljjjjkhhhjhjhjhjgjgbngjgjhjhhjkjkhhhjhjggjhjhhjgjghhjhjhjjh}
into \er{jljjjjkhhhgyggtt} gives
\begin{equation}
\label{jljjjjkhhhjhjhjhjgjgbngjgjhjhhjkjkhhhjhjggjhjhhjgjghhjhjhjjhuyuyyy11}
\vec u'(\vec x',t')=A(t)\cdot\left(B(t)\cdot \vec x+\vec
d(t)\right)+ \frac{d A}{dt}(t)\cdot\vec x+ \frac{d\vec z}{dt}(t).
\end{equation}
Then inserting
\er{jljjjjkhhhjhjhjhjgjgbngjgjhjhhjkjkhhhjhjggjhjhhjgjghhjhjhjjhgjgggjhhhjkhhkhhhkh},
\er{jljjjjkhhhjhjhjhjgjgbngjgjhjhhjkjkhhhjhjggjhjhhjgjghhjhjhjjhgjgggjhhhjkhhkhhhkhjkjkhkhk}
and
\er{jljjjjkhhhjhjhjhjgjgbngjgjhjhhjkjkhhhjhjggjhjhhjgjghhjhjhjjhgjgggjhhhjkhhkhhhkhjkjkhkhkjgkuuijhhkoiljhhhhjjhjjkhkhkhkhjkhhjgjhhhhjgjkllukbbnjhhjkjk}
into
\er{jljjjjkhhhjhjhjhjgjgbngjgjhjhhjkjkhhhjhjggjhjhhjgjghhjhjhjjhuyuyyy11}
together with
\er{jljjjjkhhhjhjhjhjgjgbngjgjhjhhjkjkhhhjhjggjhjhhjgjghhjhjhjjhgjgggjhhhjkhhkhhhkhjkjkhkhkjgkuuijhhkoiljhhhhjjhjjkhkhkhkh}
gives
\begin{multline}
\label{jljjjjkhhhjhjhjhjgjgbngjgjhjhhjkjkhhhjhjggjhjhhjgjghhjhjhjjhuyuyyy}
\vec u'(\vec x',t')=R^T(t)\cdot\left(\frac{d R}{dt}(t)\cdot
R^T(t)\cdot \vec x+\vec d(t)\right)+ \frac{d R^T}{dt}(t)\cdot\vec x
-\frac{d}{dt}\left(R^T(t)\cdot\vec h(t)\right)=\\
R^T(t)\cdot\left(\frac{d R}{dt}(t)\cdot R^T(t)+R(t)\cdot \frac{d
R^T}{dt}(t)\right)\cdot \vec x +R^T(t)\cdot\vec
d(t)-\frac{dR^T}{dt}(t)\cdot\vec h(t)-R^T(t)\cdot\frac{d\vec h}{dt}(t)\\
=0 +R^T(t)\cdot\vec d(t)-\frac{dR^T}{dt}(t)\cdot\vec
h(t)-R^T(t)\cdot\left(B(t)\cdot\vec h(t)+\vec
d(t)\right)\\=R^T(t)\cdot\vec d(t)-\frac{dR^T}{dt}(t)\cdot\vec
h(t)-R^T(t)\cdot\left(\frac{dR}{dt}(t)\cdot R^T(t)\cdot\vec
h(t)+\vec d(t)\right)\\=0- R^T(t)\cdot\left(R(t)\cdot \frac{d
R^T}{dt}(t)+\frac{d R}{dt}(t)\cdot R^T(t)\right)\cdot \vec h(t)=0.
\end{multline}
So, starting from equality \er{jljjjjkhhhjhjhjh} we finally deduce
in
\er{jljjjjkhhhjhjhjhjgjgbngjgjhjhhjkjkhhhjhjggjhjhhjgjghhjhjhjjhuyuyyy}
that
\begin{equation}
\label{jljjjjkhhhjhjhjhjgjgbngjgjhjhhjkjkhhhjhjggjhjhhjgjghhjhjhjjhuyuyyyjhjhhj}
\vec u'(\vec x',t')=0\quad\quad\forall\,(\vec x',t')
\end{equation}
in some cartesian coordinate system and thus by the definition $\vec
u$ is \underline{a generally trivial speed-like field}.

Conversely, assume that $\vec u$ is \underline{a generally trivial
speed-like field} and $(*)$ is some inertial or non-inertial
cartesian coordinate system. Then by the definition
\ref{jhjkhhjhjhgjg} there exists another cartesian coordinate system
$(**)$ such that under the change of coordinate system $(*)$ to
another cartesian coordinate system $(**)$ given by
\er{noninchgravortbstrjgghguittu2hhhhh} we have \er{jljjjjkhhh}. In
particular, in system $(**)$ we have
\begin{equation}
\label{jljjjjkhhhjhjhjhhkhh} d_{\vec x'}\vec u'+\left\{d_{\vec
x'}\vec u'\right\}^T=0.
\end{equation}
Thus, since by Proposition \ref{yghgjtgyrtrt}, {\bf(iii)}, the
quantity $\left(d_{\vec x}\vec u+\left\{d_{\vec x}\vec
u\right\}^T\right)$ is a proper matrix field, by
\er{jljjjjkhhhjhjhjhhkhh} we infer that in system $(*)$ we also have
\begin{equation*}
d_{\vec x}\vec u+\left\{d_{\vec x}\vec u\right\}^T=0,
\end{equation*}
i.e. we obtain \er{jljjjjkhhhjhjhjh}.
\end{proof}

\begin{proposition}
\label{jhjkhhjhjhgjgjjkjkjhhjh} Let $\vec v:=\vec v(\vec
x,t):\mathbb{R}^3\times[t_0,+\infty)\to\mathbb{R}^3$ be a speed-like
vector field, such that there exists some cartesian coordinate
system ${(*)}$, where we have:
\begin{equation}
\label{jljjjjkhhhjhjhjhljjjljljouukhh} \lim\limits_{|\vec
x|\to+\infty}\vec v(\vec x,t)=0\quad\quad\forall t\quad\quad\text{(
in the given particular coordinate system ${(*)}$)}\,.
\end{equation}
Then there exist a uniquely defined \underline{generally trivial
speed-like} vector field $\,\vec k:=\vec k(\vec
x,t):\mathbb{R}^3\times[t_0,+\infty)\to\mathbb{R}^3$ and a uniquely
defined \underline{proper} vector field $\,\vec h:=\vec h(\vec
x,t):\mathbb{R}^3\times[t_0,+\infty)\to\mathbb{R}^3$, so that in
every cartesian coordinate system  we have
\begin{equation}
\label{jljjjjkhhhjhjhjhljjjljljouukhhjhjg} \lim\limits_{|\vec
x|\to+\infty}\vec h(\vec x,t)=0\quad\quad\forall t\,,
\end{equation}
and in every cartesian coordinate system we can decompose vector
field $\vec v$ as:
\begin{equation}
\label{jljjjjkhhhjhjhjhljjjljljkjk} \vec v(\vec x,t)=\vec h(\vec
x,t)+\vec k(\vec x,t)\quad\quad\quad\quad\forall (\vec
x,t)\in\mathbb{R}^3\times[t_0,+\infty).
\end{equation}
\end{proposition}
\begin{proof}
Let ${(*)}$ be a cartesian coordinate system, where
\er{jljjjjkhhhjhjhjhljjjljljouukhh} satisfied. Then, by Proposition
\ref{yghgjtgyrtrgjgjgjgjgjgjgjgjgjgjgjgjgj33} there exist a unique
\underline{speed-like} vector field $\vec
k:\R^3\times[t_0,+\infty)\to\R^{3}$, such that in the given
\underline{particular} system $(*)$ we have:
\begin{equation}\label{vfyutuyfffhhgfhgfh33iy33yhijhj}
\vec k(\vec x,t)=0\quad\quad\forall(\vec x,t)
\in\R^3\times[t_0,+\infty)\,.
\end{equation}
Obviously, by \er{vfyutuyfffhhgfhgfh33iy33yhijhj} we obtain that
$\vec k$ is \underline{generally trivial} speed-like vector field.
Furthermore, by the same Proposition
\ref{yghgjtgyrtrgjgjgjgjgjgjgjgjgjgjgjgjgj33} there exist a unique
\underline{proper} vector field $\vec
h:\R^3\times[t_0,+\infty)\to\R^3$, such that in the given
\underline{particular} system $(*)$ we have:
\begin{equation}\label{vfyutuyfffhhgfhgfh33iyjkkj33hjjggjhjj}
\vec h(\vec x,t)=\vec v(\vec x,t)\quad\quad\forall(\vec x,t)
\in\R^3\times[t_0,+\infty)\,.
\end{equation}
In particular, by \er{vfyutuyfffhhgfhgfh33iyjkkj33hjjggjhjj} and
\er{jljjjjkhhhjhjhjhljjjljljouukhh} in the given
\underline{particular} system $(*)$ we have:
\begin{equation}
\label{jljjjjkhhhjhjhjhljjjljljouukhhjggjkkh} \lim\limits_{|\vec
x|\to+\infty}\vec h(\vec x,t)=0\quad\quad\forall t\,.
\end{equation}
Moreover, by together \er{vfyutuyfffhhgfhgfh33iy33yhijhj} and
\er{vfyutuyfffhhgfhgfh33iyjkkj33hjjggjhjj} in the given
\underline{particular} system $(*)$ we also have:
\begin{equation}\label{vfyutuyfffhhgfhgfh33iyjkkj33hjjggjhjjkhhjkk}
\vec h(\vec x,t)=\vec v(\vec x,t)-\vec k(\vec
x,t)\quad\quad\forall(\vec x,t) \in\R^3\times[t_0,+\infty)\,.
\end{equation}
However, since $\vec v$ and $\vec k$ are both \underline{speed-like}
vector fields, by  {\bf(i)} in Proposition \ref{yghgjtgyrtrt} we
deduce that $(\vec v-\vec k)$ is a \underline{proper} vector field.
Thus, since $(\vec v-\vec k)$ and $\vec h$ are both
\underline{proper} vector fields, we deduce that
\er{jljjjjkhhhjhjhjhljjjljljouukhhjggjkkh} and
\er{vfyutuyfffhhgfhgfh33iyjkkj33hjjggjhjjkhhjkk} hold in
\underline{every} cartesian coordinate system. So we deduce,
\er{jljjjjkhhhjhjhjhljjjljljouukhhjhjg} and
\er{jljjjjkhhhjhjhjhljjjljljkjk}.

Finally, assume that an arbitrary \underline{generally trivial}
speed-like vector field $\,\vec k:=\vec k(\vec
x,t):\mathbb{R}^3\times[t_0,+\infty)\to\mathbb{R}^3$ and an
arbitrary \underline{proper} vector field $\,\vec h:=\vec h(\vec
x,t):\mathbb{R}^3\times[t_0,+\infty)\to\mathbb{R}^3$ satisfy
\er{jljjjjkhhhjhjhjhljjjljljouukhhjhjg} and
\er{jljjjjkhhhjhjhjhljjjljljkjk}. Then, inserting
\er{jljjjjkhhhjhjhjhljjjljljouukhh} and
\er{jljjjjkhhhjhjhjhljjjljljouukhhjhjg} into
\er{jljjjjkhhhjhjhjhljjjljljkjk} we deduce that
\begin{equation}
\label{jljjjjkhhhjhjhjhljjjljljouukhhkukyhjgj}\lim\limits_{|\vec
x|\to+\infty}\vec k(\vec x,t)=0\quad\quad\forall t\quad\quad\text{(
in the given system ${(*)}$)}\,.
\end{equation}
However, since $\vec k$ is a \underline{generally trivial}
speed-like vector field, by transformation law \er{NoIn5redbstr} in
the Definition \ref{bggghghgj} we deduce that for every fixed
instant of time $t$ $\vec k$ depends on $\vec x$
\underline{linearly} in \underline{every} cartesian coordinate
system. Plugging this fact with
\er{jljjjjkhhhjhjhjhljjjljljouukhhkukyhjgj} together gives:
\begin{equation}\label{vfyutuyfffhhgfhgfh33iy33yhijhjkhkhk}
\vec k(\vec x,t)=0\quad\quad\forall(\vec x,t)
\in\R^3\times[t_0,+\infty)\quad\quad\text{( in the given system
${(*)}$)}\,.
\end{equation}
So $\vec k$ is defined uniquely (exactly as in
\er{vfyutuyfffhhgfhgfh33iy33yhijhj}). Therefore, by
\er{jljjjjkhhhjhjhjhljjjljljkjk} and
\er{vfyutuyfffhhgfhgfh33iy33yhijhjkhkhk} we obtain
\begin{equation}\label{vfyutuyfffhhgfhgfh33iyjkkj33hjjggjhjjjhjjhkh}
\vec h(\vec x,t)=\vec v(\vec x,t)\quad\quad\forall(\vec x,t)
\in\R^3\times[t_0,+\infty)\quad\quad\text{( in the given system
${(*)}$)}\,.
\end{equation}
So $\vec h$ is also defined uniquely (exactly as in
\er{vfyutuyfffhhgfhgfh33iyjkkj33hjjggjhjj}). This completes the
proof.
\end{proof}

\begin{definition}\label{jhjkhhjhjhgjggjkkljkljklkhoiuiiy}
We say that two (inertial or non-inertial) cartesian coordinate
systems $(*)$ and $(**)$ are \underline{equivalent} if the change of
coordinates from the system $(*)$ to the system $(**)$ is given by
\begin{equation}\label{noninchgravortbstrjkjkjggg33jklkhjhjhuiui}
\begin{cases}
\vec x'=B\cdot\vec x+\vec c,\\
t'=t,
\end{cases}
\end{equation}
where $B\in SO(3)$ is a constant (independent of time) rotation and
$\vec c\in \mathbb{R}^3$ is a constant (independent of time) vector.
In other words, we obtain system $(**)$ from system $(*)$ just by a
constant (independent of time) rotation in $\mathbb{R}^3$ and/or by
a constant (independent of time) shift of the origin in
$\mathbb{R}^3$.
\end{definition}
\begin{lemma}\label{jkhjhjggjghkhjhhgk}
Let $\vec v:=\vec v(\vec x,t):\mathbb{R}^3\times
[t_0,+\infty)\to\mathbb{R}^3$ be a speed-like vector field and let
$\vec h:=\vec h(\vec x,t):\mathbb{R}^3\times
[t_0,+\infty)\to\mathbb{R}^3$ be a proper vector field. Moreover,
let $(*)$ and $(**)$ by two \underline{equivalent} cartesian
coordinate systems. Then we have
\begin{equation}\label{noninchgravortbstrjkjkjggg33jklkhjhjhuiuijhjhjh}
\vec v(\vec x,t)=\vec h(\vec x,t)\quad\quad\forall(\vec
x,t)\in\mathbb{R}^3\times [t_0,+\infty)\quad\quad\quad\text{in the
system $(*)$},
\end{equation}
if and only if
\begin{equation}\label{noninchgravortbstrjkjkjggg33jklkhjhjhuiuijhjhjhkjhjhjhgh}
\vec v'(\vec x',t')=\vec h'(\vec x',t')\quad\quad\forall(\vec
x',t')\in\mathbb{R}^3\times [t_0,+\infty)\quad\quad\quad\text{in the
system $(**)$}.
\end{equation}
In particular,
\begin{equation}\label{noninchgravortbstrjkjkjggg33jklkhjhjhuiuijhjhjh44}
\vec v(\vec x,t)=0\quad\quad\forall(\vec x,t)\in\mathbb{R}^3\times
[t_0,+\infty)\quad\quad\quad\text{in the system $(*)$},
\end{equation}
if and only if
\begin{equation}\label{noninchgravortbstrjkjkjggg33jklkhjhjhuiuijhjhjhkjhjhjhgh44}
\vec v'(\vec x',t')=0\quad\quad\forall(\vec
x',t')\in\mathbb{R}^3\times [t_0,+\infty)\quad\quad\quad\text{in the
system $(**)$}.
\end{equation}
\end{lemma}
\begin{proof}
Obviously, by Definition \ref{jhjkhhjhjhgjggjkkljkljklkhoiuiiy} the
change of coordinates from system $(*)$ to system $(**)$ is given
by:
\begin{equation}\label{noninchgravortbstrjkjkjggg33jklkhjhjhuiuijkjkkjkkh}
\begin{cases}
\vec x'=A(t)\cdot\vec x+\vec z(t),\\
t'=t,
\end{cases}
\end{equation}
where $A(t):=B\in SO(3)$ is a constant (independent of time)
rotation and $\vec z(t):=\vec c\in \mathbb{R}^3$ is a constant
(independent of time) vector. In particular,
\begin{equation}\label{noninchgravortbstrjkjkjggg33jklkhjhjhuiuijkjkkjkkhkljjkljkl}
\begin{cases}
\frac{dA}{dt}(t)=0\quad\quad\forall t,\\
\frac{d\vec z}{dt}(t)=0\quad\quad\forall t,
\end{cases}
\end{equation}
and so in this case the law of transformation of a speed-like vector
field in \er{NoIn5redbstr} of Definition \ref{bggghghgj} coincides
with the law of transformation of a proper vector field in
\er{uguyytfddddgghjjg} of Definition \ref{bggghghgj}.
\end{proof}

\begin{definition}\label{jhjkhhjhjhgjg;kkljkl}
Let $\vec v:=\vec v(\vec x,t):\mathbb{R}^3\times
[t_0,+\infty)\to\mathbb{R}^3$ be a speed-like smooth vector field
and let $(*)$ be some (inertial or non-inertial) cartesian
coordinate system. We say that $\vec v$ is \underline{asymptotically
acceptable} in the coordinate system $(*)$, if there exist
continuous mappings $B(t):[t_0,+\infty)\to\mathbb{R}^{3\times 3}$
and $\vec l(t):[t_0,+\infty)\to\mathbb{R}^{3}$, such that in the
coordinate system $(*)$ we have
\begin{equation}
\label{jljjjjkhhhhkhhjhjkjkjkl}
\begin{cases}
\lim\limits_{|\vec x|\to+\infty}\left(d_{\vec x}\vec v(\vec
x,t)\right)
=B(t)\quad\quad\quad\forall t\in [t_0,+\infty)\,,\\
\lim\limits_{|\vec x|\to+\infty}\left(d_{\vec x}\vec v(\vec
x,t)+\left\{d_{\vec x}\vec v(\vec
x,t)\right\}^T\right)=0\quad\quad\quad\forall t\in [t_0,+\infty)\,,
\\
\lim\limits_{|\vec x|\to+\infty}\left(\vec v(\vec x,t)-B(t)\cdot\vec
x\right)=\vec l(t)\quad\quad\quad\forall t\in [t_0,+\infty)\,.
\end{cases}
\end{equation}
\end{definition}
\begin{proposition}
\label{jhjkhhjhjhgjgjjkjkjklljj} Let $\vec v:=\vec v(\vec
x,t):\mathbb{R}^3\times [t_0,+\infty)\to\mathbb{R}^3$ be a
speed-like smooth vector field. Then the following three conditions
are \underline{equivalent}:
\begin{itemize}
\item[{\bf (i)}]
There exists a (inertial or non-inertial) cartesian coordinate
system $(*)$  where the vector field $\vec v$ is asymptotically
acceptable.
\item[{\bf (ii)}]
The vector field $\vec v$ is asymptotically acceptable in
\underline{every} (inertial or non-inertial) cartesian coordinate
system.
\item[{\bf (iii)}]
There exists a (inertial or non-inertial) cartesian coordinate
system ($\{0\}$) in which we have \begin{align}
\label{jljjjjkhhhhkhhjhjkjkjkljojkjkjkhj1} \lim\limits_{|\vec
x|\to+\infty}\left(d_{\vec x}\vec v(\vec x,t)\right)
=0\quad\quad\quad\forall t\in [t_0,+\infty)\,,\\
\label{jljjjjkhhhhkhhjhjkjkjkljojkjkjkhj2} \lim\limits_{|\vec
x|\to+\infty}\vec v(\vec x,t)=0\quad\quad\quad\forall t\in
[t_0,+\infty)\,.
\end{align}
Moreover, in that case the cartesian coordinate system
where \er{jljjjjkhhhhkhhjhjkjkjkljojkjkjkhj2} holds, is unique, up
to an equivalence, in other words it is unique, up to a constant
(independent of time) rotation in $\mathbb{R}^3$ and/or up to a
constant (independent of time) shift of the origin in
$\mathbb{R}^3$.
\end{itemize}
\end{proposition}
\begin{proof}
Consider some cartesian
coordinate system $(*)$ where we have \er{jljjjjkhhhhkhhjhjkjkjkl}:
\begin{equation}
\label{jljjjjkhhhhkhhjhjkjkjklkhhhk}
\begin{cases}
\lim\limits_{|\vec x|\to+\infty}\left(d_{\vec x}\vec v(\vec
x,t)\right)
=B(t)\quad\quad\quad\forall t\in [t_0,+\infty)\,,\\
\lim\limits_{|\vec x|\to+\infty}\left(d_{\vec x}\vec v(\vec
x,t)+\left\{d_{\vec x}\vec v(\vec
x,t)\right\}^T\right)=0\quad\quad\quad\forall t\in [t_0,+\infty)\,,
\\
\lim\limits_{|\vec x|\to+\infty}\left(\vec v(\vec x,t)-B(t)\cdot\vec
x\right)=\vec l(t)\quad\quad\quad\forall t\in [t_0,+\infty)\,,
\end{cases}
\end{equation}
In particular, by the first and the second equation in
\er{jljjjjkhhhhkhhjhjkjkjklkhhhk} we easily deduce that
\begin{equation}
\label{jljjjjkhhhhkhhjhjkjkjklkhhhkjkk} B^T(t)
=-B(t)\quad\quad\quad\forall t\in [t_0,+\infty).
\end{equation}
Next, let $(**)$ be another cartesian coordinate system, such that
the change of the coordinate system $(*)$ to the cartesian
coordinate system $(**)$ is given by:
\begin{equation}\label{noninchgravortbstrjgghguittu2hhhhhjkljjkhkh}
\begin{cases}
\vec x'=A(t)\cdot\vec x+\vec z(t),\\
t'=t,
\end{cases}
\end{equation}
where $A(t)\in SO(3)$ is a rotation. Then since $\vec v$ is a
speed-like vector field, as before, we have
\begin{equation}
\label{jljjjjkhhhhkhhjhjkhkjj}
\vec v'(\vec x',t')=A(t)\cdot \vec
v(\vec x,t)+
\frac{d A}{dt}(t)\cdot\vec x+
\frac{d\vec z}{dt}(t).
\end{equation}
In particular, as before, we obtain
\begin{equation}
\label{jljjjjkhhhhkhhjhjkhkjjjyhjggh} d_{\vec x'}\vec v'(\vec
x',t')=A(t)\cdot \left(d_{\vec x}\vec v(\vec x,t)\right)\cdot
A^{-1}(t)+
\frac{d A}{dt}(t)\cdot A^{-1}(t)=A(t)\cdot \left(d_{\vec x}\vec
v(\vec x,t)\right)\cdot A^{T}(t)+
\frac{d A}{dt}(t)\cdot A^{T}(t).
\end{equation}
Therefore, by the first equation in
\er{jljjjjkhhhhkhhjhjkjkjklkhhhk} we deduce
\begin{multline}
\label{jljjjjkhhhhkhhjhjkjkjklkhhhkknjnkkkjoj} \lim\limits_{|\vec
x'|\to+\infty}\left(d_{\vec x'}\vec v'(\vec
x',t')\right)=\lim\limits_{|\vec x|\to+\infty}\left(A(t)\cdot
\left(d_{\vec x}\vec v(\vec x,t)\right)\cdot A^{T}(t)+ \frac{d
A}{dt}(t)\cdot A^{T}(t)\right)\\=A(t)\cdot B(t)\cdot
A^{T}(t)+\frac{d A}{dt}(t)\cdot
A^{T}(t)=B'(t')\quad\quad\quad\forall t'=t\in [t_0,+\infty),
\end{multline}
where we denote
\begin{equation}
\label{jljjjjkhhhhkhhjhjkjkjklkhhhkknjnkkkjojjkjk} B'(t'):=A(t)\cdot
B(t)\cdot A^{T}(t)+\frac{d A}{dt}(t)\cdot
A^{T}(t)\quad\quad\quad\forall t'=t\in [t_0,+\infty).
\end{equation}
Then, by the third equation in \er{jljjjjkhhhhkhhjhjkjkjklkhhhk}
together with \er{jljjjjkhhhhkhhjhjkjkjklkhhhkknjnkkkjojjkjk},
\er{jljjjjkhhhhkhhjhjkhkjj} and
\er{noninchgravortbstrjgghguittu2hhhhhjkljjkhkh} we infer:
\begin{multline}
\label{jljjjjkhhhhkhhjhjkjkjklkhhhkknjnkkkjoj1} \lim\limits_{|\vec
x'|\to+\infty}\left(\vec v'(\vec x',t')-B'(t')\cdot\vec x'\right)=
\lim\limits_{|\vec x|\to+\infty}\Bigg\{\left( A(t)\cdot \vec v(\vec
x,t)+ \frac{d A}{dt}(t)\cdot\vec x+ \frac{d\vec
z}{dt}(t)\right)\\-\left(A(t)\cdot B(t)\cdot A^{T}(t)+\frac{d
A}{dt}(t)\cdot A^{T}(t)\right)\cdot\left(A(t)\cdot\vec x+\vec
z(t)\right)\Bigg\}\\= \lim\limits_{|\vec x|\to+\infty}\Bigg\{\left(
A(t)\cdot \vec v(\vec x,t)+ \frac{d A}{dt}(t)\cdot\vec x+
\frac{d\vec z}{dt}(t)\right)\\-\left(A(t)\cdot B(t)\cdot
A^{-1}(t)+\frac{d A}{dt}(t)\cdot
A^{-1}(t)\right)\cdot\left(A(t)\cdot\vec x\right)-\left(A(t)\cdot
B(t)\cdot A^{T}(t)+\frac{d A}{dt}(t)\cdot A^{T}(t)\right)\cdot\vec
z(t)\Bigg\}\\=\lim\limits_{|\vec x|\to+\infty}\Bigg\{\left(
A(t)\cdot  \left(\vec v(\vec x,t)-B(t)\cdot\vec x\right)+
\frac{d\vec z}{dt}(t)\right)-\left(A(t)\cdot B(t)\cdot
A^{T}(t)+\frac{d A}{dt}(t)\cdot A^{T}(t)\right)\cdot\vec z(t)\Bigg\}
\\
=\left(A(t)\cdot  \vec l(t)+ \frac{d\vec
z}{dt}(t)\right)-\left(A(t)\cdot B(t)\cdot A^{T}(t)+\frac{d
A}{dt}(t)\cdot A^{T}(t)\right)\cdot\vec z(t):=\vec
 l'(t')\quad\;\;\forall t'=t\in [t_0,+\infty),
\end{multline}
where we denote
\begin{equation}
\label{jljjjjkhhhhkhhjhjkjkjklkhhhkknjnkkkjojjkjkhgjghjk} \vec
l'(t'):=\left(A(t)\cdot  \vec l(t)+ \frac{d\vec
z}{dt}(t)\right)-\left(A(t)\cdot B(t)\cdot A^{T}(t)+\frac{d
A}{dt}(t)\cdot A^{T}(t)\right)\cdot\vec z(t)\quad\;\;\forall t'=t\in
[t_0,+\infty).
\end{equation}
On the other hand, by  {\bf(iii)} in Proposition \ref{yghgjtgyrtrt}
together with the second equation in
\er{jljjjjkhhhhkhhjhjkjkjklkhhhk} we obtain
\begin{multline}
\label{jljjjjkhhhhkhhjhjkjkjklkhhhkjkhkhj} \lim\limits_{|\vec
x'|\to+\infty}\left(d_{\vec x'}\vec v'(\vec x',t')+\left\{d_{\vec
x'}\vec v'(\vec
x',t')\right\}^T\right)\\=A(t)\cdot\Bigg\{\lim\limits_{|\vec
x|\to+\infty}\left(d_{\vec x}\vec v(\vec x,t)+\left\{d_{\vec x}\vec
v(\vec x,t)\right\}^T\right)\Bigg\}\cdot
A^T(t)=0\quad\quad\quad\forall t=t'\in [t_0,+\infty)\,.
\end{multline}
So, by \er{jljjjjkhhhhkhhjhjkjkjklkhhhkknjnkkkjoj},
\er{jljjjjkhhhhkhhjhjkjkjklkhhhkknjnkkkjoj1} and
\er{jljjjjkhhhhkhhjhjkjkjklkhhhkjkhkhj} we obtain
\begin{equation}
\label{jljjjjkhhhhkhhjhjkjkjklkhhhkmnhbhkh}
\begin{cases}
\lim\limits_{|\vec x'|\to+\infty}\left(d_{\vec x'}\vec v'(\vec
x',t')\right)=B'(t')\quad\quad\quad\forall t'\in [t_0,+\infty),
\\
\lim\limits_{|\vec x'|\to+\infty}\left(d_{\vec x'}\vec v'(\vec
x',t')+\left\{d_{\vec x'}\vec v'(\vec
x',t')\right\}^T\right)=0\quad\quad\quad\forall t'\in [t_0,+\infty),
\\
\lim\limits_{|\vec x'|\to+\infty}\left(\vec v'(\vec
x',t')-B'(t')\cdot\vec x'\right)=\vec l'(t')\quad\;\;\forall t'\in
[t_0,+\infty),
\end{cases}
\end{equation}
and therefore if $\vec v$ is \underline{asymptotically acceptable}
in the coordinate system {\bf (*)} then, by
\er{jljjjjkhhhhkhhjhjkjkjklkhhhkmnhbhkh} $\vec v'$ is also
\underline{asymptotically acceptable} in the coordinate system {\bf
(**)} and implication $\text{\bf (i)}\Longleftrightarrow\text{\bf
(ii)}$ follows. Moreover, by
\er{jljjjjkhhhhkhhjhjkjkjklkhhhkknjnkkkjoj} and
\er{jljjjjkhhhhkhhjhjkjkjklkhhhkknjnkkkjoj1} we deduced
\begin{equation}
\label{jljjjjkhhhhkhhjhjkjkjklkhhhkknjnkkkjojjkhkhkhthr}
\lim\limits_{|\vec x'|\to+\infty}\left(d_{\vec x'}\vec v'(\vec
x',t')\right)=B'(t'):=\left(\frac{d A}{dt}(t)+A(t)\cdot
B(t)\right)\cdot A^T(t)\quad\quad\quad\forall t'=t\in [t_0,+\infty),
\end{equation}
and
\begin{multline}
\label{jljjjjkhhhhkhhjhjkjkjklkhhhkmnhbhkhnhgjhjggjhjg}
\lim\limits_{|\vec x'|\to+\infty}\left(\vec v'(\vec
x',t')-B'(t')\cdot\vec x'\right)=\vec l'(t'):=\\ \frac{d\vec
z}{dt}(t)-\left(A(t)\cdot B(t)\cdot A^{T}(t)+\frac{d A}{dt}(t)\cdot
A^{T}(t)\right)\cdot\vec z(t)+A(t)\cdot  \vec l(t) \quad\;\;\forall
t'=t\in [t_0,+\infty),
\end{multline}
where, by \er{jljjjjkhhhhkhhjhjkjkjklkhhhkjkk} $B(t)$ satisfies
\begin{equation}
\label{jljjjjkhhhhkhhjhjkjkjklkhhhkjkkfhf} B^T(t)
=-B(t)\quad\quad\quad\forall t\in [t_0,+\infty).
\end{equation}
Next, assume that $\tau_0\in [t_0,+\infty)$ and a smooth mappings
$A_0(t):[t_0,+\infty)\to\mathbb{R}^{3\times 3}$ and $\vec
z_0(t):[t_0,+\infty)\to\mathbb{R}^{3}$ satisfy
\begin{equation}
\label{jljjjjkhhhhkhhjhjkjkjklkhhhkknjnkkkjojjkhkhkhthrkljljhg}
\begin{cases}
\frac{d A_0}{dt}(t)=-A_0(t)\cdot B(t)\quad\quad\quad\forall t\in
[t_0,+\infty)\\
A_0(\tau_0)=I,
\end{cases}
\end{equation}
and
\begin{equation}
\label{jljjjjkhhhhkhhjhjkjkjklkhhhkmnhbhkhnhgjhjggjhjggjgjgh}
\frac{d\vec z_0}{dt}(t)=\left(A(t)\cdot B(t)\cdot A^{T}(t)+\frac{d
A}{dt}(t)\cdot A^{T}(t)\right)\cdot\vec z_0(t)-A(t)\cdot  \vec l(t)
\quad\;\;\forall t=t\in [t_0,+\infty),
\end{equation}
where $I\in \mathbb{R}^{3\times 3}$ is the identity matrix. By the
Theory of Ordinary Differential Equations it is well known that such
a mappings $A_0(t)$ and $\vec z_0(t)$ are indeed exist. Then by
\er{jljjjjkhhhhkhhjhjkjkjklkhhhkknjnkkkjojjkhkhkhthrkljljhg} and
\er{jljjjjkhhhhkhhjhjkjkjklkhhhkjkkfhf} we deduce,
\begin{equation}
\label{jljjjjkhhhhkhhjhjkjkjklkhhhkknjnkkkjojjkhkhkhthrkljljhjgj}
\frac{d }{dt}\left\{A^T_0(t)\right\}=-B^T(t)\cdot A^T_0=B(t)\cdot
\left\{A^T_0(t)\right\}\quad\quad\quad\forall t\in [t_0,+\infty),
\end{equation}
and so
\begin{equation}
\label{jljjjjkhhhhkhhjhjkjkjklkhhhkknjnkkkjojjkhkhkhthrkljljhgjkjjh}
\begin{cases}
\frac{d }{dt}\left\{A^T_0(t)\right\}=B(t)\cdot
\left\{A^T_0(t)\right\}\quad\quad\quad\forall t\in [t_0,+\infty)
\\ A^T_0(\tau_0)=I.
\end{cases}
\end{equation}
In particular, again by the Theory of Ordinary Differential
Equations using
\er{jljjjjkhhhhkhhjhjkjkjklkhhhkknjnkkkjojjkhkhkhthrkljljhgjkjjh} we
deduce that
\begin{equation}
\label{jljjjjkhhhhkhhjhjkjkjklkhhhkknjnkkkjojjkhkhkhthrkljljhjgjhjh}
\det A_0^T(t)>0\quad\quad\quad\forall t\in [t_0,+\infty)
\end{equation}
and then also
\begin{equation}
\label{jljjjjkhhhhkhhjhjkjkjklkhhhkknjnkkkjojjkhkhkhthrkljljhjgjhjhkkjkljj}
\det A_0(t)>0\quad\quad\quad\forall t\in [t_0,+\infty).
\end{equation}
So, $A_0(t)$ and $A^T_0(t)$ are invertible matrices. On the other
hand, since differentiating the identity $$A^{-1}_0(t)\cdot
A_0(t)=I\quad\quad\quad\forall t'=t\in [t_0,+\infty)$$ gives
\begin{equation*}
\left(\frac{d}{dt}\left\{A_0^{-1}(t)\right\}\right)\cdot
A_0(t)+A_0^{-1}(t)\cdot\frac{d A_0}{dt}(t)=0\quad\quad\quad\forall
t'=t\in [t_0,+\infty),
\end{equation*}
we deduce
\begin{equation}
\label{jljjjjkhhhhkhhjhjkjkjklkhhhkknjnkkkjojjkhkhkhthrgghghjkjhmnb}
\frac{d}{dt}\left\{A_0^{-1}(t)\right\}=-A_0^{-1}(t)\cdot\frac{d
A_0}{dt}(t)\cdot A_0^{-1}(t)\quad\quad\quad\forall t'=t\in
[t_0,+\infty).
\end{equation}
Thus, by
\er{jljjjjkhhhhkhhjhjkjkjklkhhhkknjnkkkjojjkhkhkhthrkljljhg} and
\er{jljjjjkhhhhkhhjhjkjkjklkhhhkknjnkkkjojjkhkhkhthrgghghjkjhmnb} we
deduce
\begin{equation}
\label{jljjjjkhhhhkhhjhjkjkjklkhhhkknjnkkkjojjkhkhkhthrkljljhjgjjkhjhgghghjkjkhkhk}
\frac{d}{dt}\left\{A^{-1}_0(t)\right\}=-A_0^{-1}(t)\cdot \left(-
A_0(t)\cdot B(t)\right)\cdot A_0^{-1}(t)=B(t)\cdot A_0^{-1}(t)
\quad\quad\forall t\in [t_0,+\infty),
\end{equation}
and so
\begin{equation}
\label{jljjjjkhhhhkhhjhjkjkjklkhhhkknjnkkkjojjkhkhkhthrkljljhgjkjjh222}
\begin{cases}
\frac{d }{dt}\left\{A^{-1}_0(t)\right\}=B(t)\cdot
\left\{A^{-1}_0(t)\right\}\quad\quad\quad\forall t\in [t_0,+\infty)
\\ A^{-1}_0(\tau_0)=I.
\end{cases}
\end{equation}
Therefore, by Uniqueness Theorem for the Cauchy Problem for Ordinary
Differential Equations,
\er{jljjjjkhhhhkhhjhjkjkjklkhhhkknjnkkkjojjkhkhkhthrkljljhgjkjjh}
and
\er{jljjjjkhhhhkhhjhjkjkjklkhhhkknjnkkkjojjkhkhkhthrkljljhgjkjjh222}
together imply that
\begin{equation}
\label{jljjjjkhhhhkhhjhjkjkjklkhhhkknjnkkkjojjkhkhkhthrkljljhjgjjkhjhgghghjkjkhkhkkhhk}
A^{-1}_0(t)=A^{T}_0(t)\quad\quad\quad\forall t\in [t_0,+\infty).
\end{equation}
So by
\er{jljjjjkhhhhkhhjhjkjkjklkhhhkknjnkkkjojjkhkhkhthrkljljhjgjhjhkkjkljj}
and
\er{jljjjjkhhhhkhhjhjkjkjklkhhhkknjnkkkjojjkhkhkhthrkljljhjgjjkhjhgghghjkjkhkhkkhhk}
we deduce that $A_0(t)$, which solves
\er{jljjjjkhhhhkhhjhjkjkjklkhhhkknjnkkkjojjkhkhkhthrkljljhg},
necessary satisfies $A_0(t)\in SO(3)$. In other words $A_0(t)$ which
solves \er{jljjjjkhhhhkhhjhjkjkjklkhhhkknjnkkkjojjkhkhkhthrkljljhg}
is a \underline{rotation}.

Therefore, there exists some particular cartesian coordinate system
$(**)$, such that the change of the coordinate system $(*)$ to the
cartesian coordinate system $(**)$ is given by
\er{noninchgravortbstrjgghguittu2hhhhhjkljjkhkh} with $A(t)=A_0(t)$
and $\vec z(t)=\vec z_0(t)$:
\begin{equation}\label{noninchgravortbstrjgghguittu2hhhhhjkljjkhkhkhh}
\begin{cases}
\vec x'=A(t)\cdot\vec x+\vec z(t)=A_0(t)\cdot\vec x+\vec z_0(t),\\
t'=t,
\end{cases}
\end{equation}
and moreover, by
\er{jljjjjkhhhhkhhjhjkjkjklkhhhkknjnkkkjojjkhkhkhthrkljljhg} for
this system transformation we have
\begin{equation}
\label{jljjjjkhhhhkhhjhjkjkjklkhhhkknjnkkkjojjkhkhkhthrkljljhgukkuuyk}
\frac{d A}{dt}(t)+A(t)\cdot B(t)=0\quad\quad\quad\forall t\in
[t_0,+\infty),
\end{equation}
and by \er{jljjjjkhhhhkhhjhjkjkjklkhhhkmnhbhkhnhgjhjggjhjggjgjgh} we
have
\begin{equation}
\label{jljjjjkhhhhkhhjhjkjkjklkhhhkmnhbhkhnhgjhjggjhjggjgjghyjh}
\frac{d\vec z}{dt}(t)-\left(A(t)\cdot B(t)\cdot A^{T}(t)+\frac{d
A}{dt}(t)\cdot A^{T}(t)\right)\cdot\vec z(t)+A(t)\cdot  \vec l(t)=0.
\quad\;\;\forall t\in [t_0,+\infty),
\end{equation}
Thus, inserting
\er{jljjjjkhhhhkhhjhjkjkjklkhhhkknjnkkkjojjkhkhkhthrkljljhgukkuuyk}
into \er{jljjjjkhhhhkhhjhjkjkjklkhhhkknjnkkkjojjkhkhkhthr} we deduce
\begin{equation}
\label{jljjjjkhhhhkhhjhjkjkjklkhhhkknjnkkkjojjkhkhkhthrggg}
\lim\limits_{|\vec x'|\to+\infty}\left(d_{\vec x'}\vec v'(\vec
x',t')\right)=0\quad\quad\quad\forall t'\in
[t_0,+\infty)\quad\quad\text{(in the chosen system  $(**)$)}.
\end{equation}
Moreover, inserting
\er{jljjjjkhhhhkhhjhjkjkjklkhhhkmnhbhkhnhgjhjggjhjggjgjghyjh} into
\er{jljjjjkhhhhkhhjhjkjkjklkhhhkmnhbhkhnhgjhjggjhjg} we deduce
\begin{multline}
\label{jljjjjkhhhhkhhjhjkjkjklkhhhkmnhbhkhnhgjhjggjhjgjggh}
\lim\limits_{|\vec x'|\to+\infty}\vec v'(\vec
x',t')=\lim\limits_{|\vec x'|\to+\infty}\left(\vec v'(\vec
x',t')-B'(t')\cdot\vec x'\right)=0\\ \quad\quad\quad\quad\forall
t'\in [t_0,+\infty)\quad\quad\quad\text{(in the chosen system
$(**)$)}.
\end{multline}
and by \er{jljjjjkhhhhkhhjhjkjkjklkhhhkknjnkkkjojjkhkhkhthrggg}
together with
\er{jljjjjkhhhhkhhjhjkjkjklkhhhkmnhbhkhnhgjhjggjhjgjggh} implication
$\text{\bf (i)}\Longleftrightarrow\text{\bf (iii)}$ also follows.

Finally, consider two cartesian coordinate systems $(\{1\})$ and
$(\{2\})$ so that the change of coordinate system $(\{1\})$ to the
system $(\{2\})$ can be written as:
\begin{equation}\label{noninchgravortbstrjkjkjggg33hhjjh}
\begin{cases}
\vec x^{(2)}=A(t^{(1)})\cdot\vec x^{(1)}+\vec z(t^{(1)}),\\
t^{(2)}=t^{(1)},
\end{cases}
\end{equation}
where $A(t)\in SO(3)$ is a rotation. Moreover, assume that the field
$\vec v$ in both these systems satisfies
\begin{align}
\label{jljjjjkhhhhkhhjhjkjkjklkhhhkmnhbhkhnhgjhjggjhjgjgghjhjhjh}
\lim\limits_{|\vec x^{(1)}|\to+\infty}\vec v^{(1)}(\vec
x^{(1)},t^{(1)})=0 \quad\quad\quad\quad\forall t^{(1)}=t\in
[t_0,+\infty),\\
\label{jljjjjkhhhhkhhjhjkjkjklkhhhkmnhbhkhnhgjhjggjhjgjgghjhjhjh1}
\lim\limits_{|\vec x^{(2)}|\to+\infty}\vec v^{(2)}(\vec
x^{(2)},t^{(2)})=0 \quad\quad\quad\quad\forall t^{(2)}=t\in
[t_0,+\infty).
\end{align}
Then, since $\vec v$ is a speed-like vector field, by
\er{NoIn5redbstr} in Definition \ref{bggghghgj} we deduce that
\begin{equation}
\label{jljjjjkhhhhkhhjhjkhkjjjkljjhjh} \vec v^{(2)}(\vec
x^{(2)},t)=A(t)\cdot \vec v^{(1)}(\vec x^{(1)},t)+
\frac{d A}{dt}(t)\cdot\vec x^{(1)}+
\frac{d\vec z}{dt}(t)\quad\quad\quad\quad\forall t\in [t_0,+\infty).
\end{equation}
Thus, by \er{jljjjjkhhhhkhhjhjkhkjjjkljjhjh} two equalities
\er{jljjjjkhhhhkhhjhjkjkjklkhhhkmnhbhkhnhgjhjggjhjgjgghjhjhjh} and
\er{jljjjjkhhhhkhhjhjkjkjklkhhhkmnhbhkhnhgjhjggjhjgjgghjhjhjh1} can
be satisfied simultaneously if and only if we have
\begin{equation}
\label{jljjjjkhhhhkhhjhjkhkjjjkljjhjhjgg}\frac{d
A}{dt}(t)=0\quad\text{and}\quad \frac{d\vec
z}{dt}(t)=0\quad\quad\quad\quad\forall t\in [t_0,+\infty),
\end{equation}
and so this happens if and only if $A$ and $\vec z$ in
\er{noninchgravortbstrjkjkjggg33hhjjh} are constants (independent of
time), that means two systems $(\{1\})$ and $(\{2\})$ are
\underline{equivalent}.
This completes the proof.
\end{proof}
\begin{corollary}
\label{jhjkhhjhjhgjgjjkjkjhhjhCC} Let $\vec v:=\vec v(\vec
x,t):\mathbb{R}^3\times[t_0,+\infty)\to\mathbb{R}^3$ be an
\underline{asymptotically acceptable} speed-like vector field. Then
there exist a uniquely defined \underline{generally trivial
speed-like} vector field $\,\vec k:=\vec k(\vec
x,t):\mathbb{R}^3\times[t_0,+\infty)\to\mathbb{R}^3$ and a uniquely
defined \underline{proper} vector field $\,\vec h:=\vec h(\vec
x,t):\mathbb{R}^3\times[t_0,+\infty)\to\mathbb{R}^3$, so that in
every cartesian coordinate system we have
\begin{equation}
\label{jljjjjkhhhjhjhjhljjjljljouukhhjhjgCC} \lim\limits_{|\vec
x|\to+\infty}\vec h(\vec x,t)=0\quad\quad\forall t\,,
\end{equation}
and in every cartesian coordinate system we can decompose vector
field $\vec v$ as:
\begin{equation}
\label{jljjjjkhhhjhjhjhljjjljljkjkCC} \vec v(\vec x,t)=\vec h(\vec
x,t)+\vec k(\vec x,t)\quad\quad\quad\quad\forall (\vec
x,t)\in\mathbb{R}^3\times[t_0,+\infty).
\end{equation}
Moreover, in addition, the proper vector field $\,\vec h$
necessarily satisfies
\begin{equation}
\label{jljjjjkhhhjhjhjhljjjljljouukhhjhjgCCkhjkjhCC}
\lim\limits_{|\vec x|\to+\infty}\left(d_{\vec x}\vec h(\vec
x,t)\right)=0\quad\quad\forall t\quad\quad\quad\text{(in every
cartesian coordinate system)}\,.
\end{equation}
\end{corollary}
\begin{proof}
The result is a direct consequence of both Propositions
\ref{jhjkhhjhjhgjgjjkjkjhhjh} and \ref{jhjkhhjhjhgjgjjkjkjklljj},
where only identity
\er{jljjjjkhhhjhjhjhljjjljljouukhhjhjgCCkhjkjhCC} require to be
proved. In order to prove
\er{jljjjjkhhhjhjhjhljjjljljouukhhjhjgCCkhjkjhCC} consider the
system $(*)$ where we have
\begin{equation}\label{ghghfjdfjgdg}
\begin{cases}
\lim\limits_{|\vec x|\to+\infty}\left(d_{\vec x}\vec v(\vec
x,t)\right)
=0\quad\quad\quad\forall t\in [t_0,+\infty)\,,\\
\lim\limits_{|\vec x|\to+\infty}\vec v(\vec
x,t)=0\quad\quad\quad\forall t\in [t_0,+\infty)\,,
\end{cases}
\end{equation}
and consider the system $(**)$ where we have
\begin{equation}
\label{jljjjjkhhhjhjhjhljjjljljkjkCCjhgghjkk}
\vec k'(\vec
x',t')=0\quad\quad\quad\quad\forall (\vec
x',t')\in\mathbb{R}^3\times[t_0,+\infty).
\end{equation}
On the other hand, as in \er{jljjjjkhhhjhjhjhljjjljljouukhhjhjgCC}
and \er{jljjjjkhhhjhjhjhljjjljljkjkCC} in the system $(**)$ we also
have
\begin{equation}
\label{jljjjjkhhhjhjhjhljjjljljouukhhjhjgCChh} \lim\limits_{|\vec
x'|\to+\infty}\vec h'(\vec x',t')=0\quad\quad\forall t'\,,
\end{equation}
and
\begin{equation}
\label{jljjjjkhhhjhjhjhljjjljljkjkCChh} \vec v'(\vec x',t')=\vec
h'(\vec x',t')+\vec k'(\vec x',t')\quad\quad\quad\quad\forall (\vec
x',t')\in\mathbb{R}^3\times[t_0,+\infty).
\end{equation}
In particular, by \er{jljjjjkhhhjhjhjhljjjljljkjkCChh},
\er{jljjjjkhhhjhjhjhljjjljljkjkCCjhgghjkk} and
\er{jljjjjkhhhjhjhjhljjjljljouukhhjhjgCChh} together we deduce that
in system $(**)$ we have
\begin{equation}
\label{jljjjjkhhhjhjhjhljjjljljkjkCChhouu} \vec v'(\vec x',t')=\vec
h'(\vec x',t')\quad\quad\quad\quad\forall (\vec
x',t')\in\mathbb{R}^3\times[t_0,+\infty),
\end{equation}
and
\begin{equation}
\label{jljjjjkhhhjhjhjhljjjljljouukhhjhjgCChhljjkljkl}
\lim\limits_{|\vec x'|\to+\infty}\vec v'(\vec
x',t')=0\quad\quad\forall t'\,,
\end{equation}
Therefore, again using Proposition \ref{jhjkhhjhjhgjgjjkjkjklljj},
by the second equality in \er{ghghfjdfjgdg} together with
\er{jljjjjkhhhjhjhjhljjjljljouukhhjhjgCChhljjkljkl} we obtain that
systems $(*)$ and $(**)$ are \underline{equivalent}. By this
equivalence and by Lemma \ref{jkhjhjggjghkhjhhgk}, using
\er{jljjjjkhhhjhjhjhljjjljljkjkCChhouu} we deduce that in system
$(*)$ we also have
\begin{equation}
\label{jljjjjkhhhjhjhjhljjjljljkjkCChhouujkljkjk} \vec v(\vec
x,t)=\vec h(\vec x,t)\quad\quad\quad\quad\forall (\vec
x,t)\in\mathbb{R}^3\times[t_0,+\infty).
\end{equation}
Thus, by \er{jljjjjkhhhjhjhjhljjjljljkjkCChhouujkljkjk} and by the
first equality in \er{ghghfjdfjgdg} we finally deduce
\er{jljjjjkhhhjhjhjhljjjljljouukhhjhjgCCkhjkjhCC}.
\end{proof}

\begin{definition}\label{gjgjffjjjjjjjjjjjjjjjjjjjjjjhh}
Let $\mathcal{U}$ be a nonempty subset of the set of all inertial
and non-inertial cartesian coordinate systems. We say that
$\mathcal{U}$ is an \underline{extended Galilean group} if, given
arbitrary cartesian coordinate system $(*)$ inside the subset
$\mathcal{U}$, the following statement holds:
\begin{itemize}
\item
The arbitrary cartesian coordinate system $(**)$ belongs to the
subset $\mathcal{U}$ if and only if there exist a constant
(independent of time) rotation $B\in SO(3)$ and constant
(independent of time) vectors $\vec c\in \mathbb{R}^3$ and $\vec
w\in \mathbb{R}^3$, so that the change of coordinates from the
system $(*)$ to the system $(**)$ is given by
\begin{equation}\label{noninchgravortbstrjkjkjggg33jklkhjhjhuiuiljjljjkkl}
\begin{cases}
\vec x'=B\cdot\vec x+\vec c+\vec w t,\\
t'=t.
\end{cases}
\end{equation}
\end{itemize}
In other words, the system $(**)$ belongs to $\mathcal{U}$ if and
only if, up to equivalence of cartesian coordinate systems, the
system $(**)$ can be obtained from the system $(*)$ by the Galilean
Transformation
\begin{equation}\label{noninchgravortbstrjgghguittu1hjmjh}
\begin{cases}
\vec x'=\vec x+\vec wt,\\
t'=t.
\end{cases}
\end{equation}
\end{definition}
\begin{definition}\label{gjgjffjjjjjjjjjjjjjjjjjjjjjjhhr} Let
$\mathcal{V}$ be a nonempty subset of the set of all inertial and
non-inertial cartesian coordinate systems. We say that $\mathcal{V}$
is a \underline{non-rotational rigid motion group} if, given
arbitrary cartesian coordinate system $(*)$ inside the subset
$\mathcal{V}$, the following statement holds:
\begin{itemize}
\item
The arbitrary cartesian coordinate system $(**)$ belongs to the
subset $\mathcal{V}$ if and only if there exist a constant
(independent of time) rotation $B\in SO(3)$ and a smooth
vector-valued function of the time variable $\vec z(t):\mathbb{R}\to
\mathbb{R}^3$, so that the change of coordinates from the system
$(*)$ to the system $(**)$ is given by
\begin{equation}\label{noninchgravortbstrjkjkjggg33jklkhjhjhuiuiljjljjkklr}
\begin{cases}
\vec x'=B\cdot\vec x+\vec z(t),\\
t'=t.
\end{cases}
\end{equation}
\end{itemize}
\end{definition}
\begin{definition}\label{gjgjffjjjjjjjjjjjjjjjjjjjjjjhhrf}
Let $\mathcal{W}$ be a certain nonempty subset of coordinate
systems. We say that $\mathcal{W}$ is a \underline{general rigid
motion group} if, given arbitrary coordinate system $(*)$ inside the
subset $\mathcal{W}$, the following statement holds:
\begin{itemize}
\item
The arbitrary coordinate system $(**)$ belongs to the subset
$\mathcal{W}$ if and only if there exist a smooth (time dependent)
rotation $A(t):\mathbb{R}\to SO(3)$ and a smooth vector-valued
function of the time variable $\vec z(t):\mathbb{R}\to
\mathbb{R}^3$, so that the change of coordinates from the system
$(*)$ to the system $(**)$ is given by
\begin{equation}\label{noninchgravortbstrjkjkjggg33jklkhjhjhuiuiljjljjkklrf}
\begin{cases}
\vec x'=A(t)\cdot\vec x+\vec z(t),\\
t'=t.
\end{cases}
\end{equation}
\end{itemize}
In particular, the set of all \underline{cartesian} coordinate
systems is a general rigid motion group.
\end{definition}

\begin{definition}\label{jhjkhhjhjhgjg;kkljkljkkjlhlhhlh}
Let $\vec v:=\vec v(\vec x,t):\mathbb{R}^3\times
[t_0,+\infty)\to\mathbb{R}^3$ be an \underline{asymptotically
acceptable} speed-like smooth vector field (see Definition
\ref{jhjkhhjhjhgjg;kkljkl}).
\begin{itemize}
\item
We call that the cartesian coordinate system ($\{0\}$) is
\underline{preferable} for the vector field $\vec v$ if in the
system ($\{0\}$) we have
\begin{align} \label{jljjjjkhhhhkhhjhjkjkjkljojkjkjkhj1hh}
\lim\limits_{|\vec x|\to+\infty}\left(d_{\vec x}\vec v(\vec
x,t)\right)
=0\quad\quad\quad\forall t\in [t_0,+\infty)\,,\\
\label{jljjjjkhhhhkhhjhjkjkjkljojkjkjkhj2hh}
\lim\limits_{|\vec
x|\to+\infty}\vec v(\vec x,t)=0\quad\quad\quad\forall t\in
[t_0,+\infty)\,.
\end{align}
Note that by Proposition \ref{jhjkhhjhjhgjgjjkjkjklljj} the unique
(up to equivalence) preferable for the vector field $\vec v$
cartesian system ($\{0\}$) exists.
\item
We call that the cartesian coordinate system $(*)$ is
\underline{inertial} with respect to the vector field $\vec v$ if in
the system $(*)$ we have
\begin{align} \label{jljjjjkhhhhkhhjhjkjkjkljojkjkjkhj1hhj}
\lim\limits_{|\vec x|\to+\infty}\left(d_{\vec x}\vec v(\vec
x,t)\right)
=0\quad\quad\quad\forall t\in [t_0,+\infty)\,,\\
\label{jljjjjkhhhhkhhjhjkjkjkljojkjkjkhj2hhj} \lim\limits_{|\vec
x|\to+\infty}\vec v(\vec x,t)=\vec d\quad\quad\quad\forall t\in
[t_0,+\infty)\,,
\end{align}
where $\vec d\in \mathbb{R}^3$ is a constant (independent of $t$)
vector. Note that, in particular, by the definition, the preferable
for $\vec v$ cartesian system is also an inertial with respect to
$\vec v$  coordinate system.
\item
We call that the cartesian coordinate system $(*)$ is
\underline{non-rotating} with respect to the vector field $\vec v$
if in the system $(*)$ we have
\begin{equation} \label{jljjjjkhhhhkhhjhjkjkjkljojkjkjkhj1hhjhkkj}
\lim\limits_{|\vec x|\to+\infty}\left(d_{\vec x}\vec v(\vec
x,t)\right)=0\quad\quad\quad\forall t\in [t_0,+\infty)\,,
\end{equation}
(without specifying $\lim\limits_{|\vec x|\to+\infty}\vec v(\vec
x,t)$). Note that, in particular, by the definition, every inertial
with respect to $\vec v$ cartesian coordinate system is also a
non-rotating with respect to $\vec v$ coordinate system.
\end{itemize}
\end{definition}
\begin{proposition}\label{jghfgjdhdhg}
Let $\vec v:=\vec v(\vec x,t):\mathbb{R}^3\times
[t_0,+\infty)\to\mathbb{R}^3$ be an \underline{asymptotically
acceptable} speed-like smooth vector field (see Definition
\ref{jhjkhhjhjhgjg;kkljkl}). Furthermore, let $\mathcal{U}$ be the
set of all \underline{inertial} with respect to $\vec v$ cartesian
coordinate systems and let $\mathcal{V}$ be the set of all
\underline{non-rotating} with respect to $\vec v$ cartesian
coordinate systems (see Definition
\ref{jhjkhhjhjhgjg;kkljkljkkjlhlhhlh}). Then $\mathcal{U}$ is an
\underline{extended Galilean group} and  $\mathcal{V}$ is a
\underline{non-rotational rigid motion group} (see Definitions
\ref{gjgjffjjjjjjjjjjjjjjjjjjjjjjhh} and
\ref{gjgjffjjjjjjjjjjjjjjjjjjjjjjhhr}).
\end{proposition}
\begin{proof}
Consider the change of some cartesian coordinate system $(*)$ to
another cartesian coordinate system $(**)$ of the form:
\begin{equation}\label{noninchgravortbstrgjgghhhh}
\begin{cases}
\vec x'=A(t)\cdot\vec x+\vec z(t),\\
t'=t,
\end{cases}
\end{equation}
where $A(t)\in SO(3)$ and correspondingly to \er{NoIn5redbstr} we
have
\begin{equation}
\label{NoIn5redbstrhhhh}\vec v'(\vec x',t')=A(t)\cdot \vec v(\vec
x,t)+
\frac{d A}{dt}(t)\cdot\vec x+
\frac{d\vec z}{dt}(t).
\end{equation}
In particular, by \er{noninchgravortbstrgjgghhhh},
\er{NoIn5redbstrhhhh} and the Chain rule we deduce
\begin{equation}
\label{NoIn5redbstrhhhhjkhhk}\left\{d_{\vec x'}\vec v'(\vec
x',t')\right\}\cdot A(t)=d_{\vec x}\left(A(t)\cdot \vec v(\vec x,t)+
\frac{d A}{dt}(t)\cdot\vec x+
\frac{d\vec z}{dt}(t)\right)=A(t)\cdot \left\{d_{\vec x}\vec v(\vec
x,t)\right\}+\frac{d A}{dt}(t),
\end{equation}
and thus
\begin{equation}
\label{NoIn5redbstrhhhhjkhhkzz}\left(\lim\limits_{|\vec
x'|\to+\infty}\left\{d_{\vec x'}\vec v'(\vec
x',t')\right\}\right)\cdot A(t)=A(t)\cdot \left(\lim\limits_{|\vec
x|\to+\infty}\left\{d_{\vec x}\vec v(\vec
x,t)\right\}\right)+\frac{d A}{dt}(t).
\end{equation}
Therefore, by \er{NoIn5redbstrhhhhjkhhkzz}, in both systems $(*)$
and $(**)$ we have simultaneously
\begin{equation} \label{jljjjjkhhhhkhhjhjkjkjkljojkjkjkhj1hhjhkkjjljlsg}
\lim\limits_{|\vec x|\to+\infty}\left(d_{\vec x}\vec v(\vec
x,t)\right)=0\quad\quad\quad\forall t\in [t_0,+\infty)\,,
\end{equation}
and
\begin{equation} \label{jljjjjkhhhhkhhjhjkjkjkljojkjkjkhj1hhjhkkjjljljykukl}
\lim\limits_{|\vec x'|\to+\infty}\left(d_{\vec x'}\vec v'(\vec
x',t')\right)=0\quad\quad\quad\forall t'\in [t_0,+\infty)\,,
\end{equation}
if and only if
\begin{equation}
\label{jljjjjkhhhhkhhjhjkjkjkljojkjkjkhj1hhjhkkjjljljykuklljkj}
\frac{d A}{dt}(t)=0\quad\quad\quad\forall t\,,
\end{equation}
or equivalently
\begin{equation} \label{jljjjjkhhhhkhhjhjkjkjkljojkjkjkhj1hhjhkkjjljljykuklljkbmbm}
A(t)=B\quad\quad\quad\forall t\,,
\end{equation}
where $B\in SO(3)$ is a constant rotation, so that
\begin{equation}\label{noninchgravortbstrjkjkjggg33jklkhjhjhuiuiljjljjkklrhkjhjhjhhh}
\begin{cases}
\vec x'=B\cdot\vec x+\vec z(t),\\
t'=t.
\end{cases}
\end{equation}
and
\begin{equation}
\label{NoIn5redbstrhhhhkljlj}\vec v'(\vec x',t')=B\cdot \vec v(\vec
x,t)+ \frac{d\vec z}{dt}(t).
\end{equation}
In particular, by Definitions \ref{gjgjffjjjjjjjjjjjjjjjjjjjjjjhhr}
and \ref{jhjkhhjhjhgjg;kkljkljkkjlhlhhlh} $\mathcal{V}$ is indeed a
non-rotational rigid motion group. Moreover, by
\er{NoIn5redbstrhhhhkljlj} we deduce
\begin{equation}
\label{NoIn5redbstrhhhhkljljkhhjhjhjhgj}\lim\limits_{|\vec
x'|\to+\infty}\vec v'(\vec x',t')=B\cdot \left(\lim\limits_{|\vec
x|\to+\infty}\vec v(\vec x,t)\right)+ \frac{d\vec z}{dt}(t).
\end{equation}
and thus, in both systems $(*)$ and $(**)$ the limits
$\lim\limits_{|\vec x'|\to+\infty}\vec v'(\vec x',t')$ and
$\lim\limits_{|\vec x|\to+\infty}\vec v(\vec x,t)$ are independent
on time simultaneously if and only if $\vec w:=\frac{d\vec
z}{dt}(t)$ is a constant (independent of time) vector. In that case
there exists constant vectors $\vec c\in \mathbb{R}^3$ and $\vec
w\in \mathbb{R}^3$ so that
\begin{equation}
\label{NoIn5redbstrhhhhkljljkhhjhjhjhgjjkjkjkjh} \vec z(t)=\vec
c+\vec w t\quad\quad\forall t.
\end{equation}
Inserting \er{NoIn5redbstrhhhhkljljkhhjhjhjhgjjkjkjkjh} into
\er{noninchgravortbstrjkjkjggg33jklkhjhjhuiuiljjljjkklrhkjhjhjhhh}
gives
\begin{equation}\label{noninchgravortbstrjkjkjggg33jklkhjhjhuiuiljjljjkklrhkjhjhjhhhjkhgj}
\begin{cases}
\vec x'=B\cdot\vec x+\vec
c+\vec w t,\\
t'=t.
\end{cases}
\end{equation}
In particular, by Definitions \ref{gjgjffjjjjjjjjjjjjjjjjjjjjjjhh}
and \ref{jhjkhhjhjhgjg;kkljkljkkjlhlhhlh} $\mathcal{U}$ is indeed an
extended Galilean group.
\end{proof}

\begin{definition}\label{gjgjffjjjjjjjjjjjjjjjjjjjjjjhh2}
Given an \underline{extended Galilean group} $\mathcal{U}$ (see
Definition \ref{gjgjffjjjjjjjjjjjjjjjjjjjjjjhh}), we say that the
\underline{scalar} field $Z:=Z(\vec
x,t):\R^3\times[t_0,+\infty)\to\R$, defined in every cartesian
system inside the set $\mathcal{U}$, is a \underline{speed-like
field generator} on $\mathcal{U}$ if, under every change of two
coordinate systems $(*)$ and $(**)$ inside $\mathcal{U}$, given by
\er{noninchgravortbstrjkjkjggg33jklkhjhjhuiuiljjljjkkl}, the
quantity $Z$ transforms as:
\begin{equation}\label{noninchgravortbstrjgghguittu1intmmjhhjhihgggy}
Z'(\vec x',t')=Z(\vec x,t)+\vec w\cdot\left(B\cdot\vec x
\right)+\frac{1}{2}|\vec w|^2t+C_{*}^{**},
\end{equation}
where $C_{*}^{**}\in\mathbb{R}$ is an unspecified constant,
independent of the space and time variables $(\vec x,t)$, which is,
however, allowed to depend on both coordinate systems $(*)$ and
$(**)$.
\end{definition}
\begin{lemma}\label{ygjhgjgjgjhgjtu}
Let $\vec v:=\vec v(\vec x,t):\R^3\times[t_0,+\infty)\to\R^3$ be a
\underline{speed-like} vector field. Furthermore, let $\mathcal{U}$
be an \underline{extended Galilean group} and let $Z:=Z(\vec
x,t):\R^3\times[t_0,+\infty)\to\R$, be a \underline{speed-like field
generator} on $\mathcal{U}$, such that there exists cartesian
coordinate system $(*)$ inside $\mathcal{U}$, where we have
\begin{equation}\label{vfyutuyfffhhgfhgfh33iy33hhjgghgj}
\vec v(\vec x,t)=\nabla_{\vec x}Z(\vec x,t)\quad\quad\forall(\vec
x,t) \in\R^3\times[t_0,+\infty)\quad\quad\quad\text{in the given
particular system $(*)$}\,.
\end{equation}
Then, in \underline{every} cartesian coordinate system $(**)$ inside
the set $\mathcal{U}$ we also have
\begin{equation}\label{vfyutuyfffhhgfhgfh33iy33hhjgghjhgh}
\vec v'(\vec x',t')=\nabla_{\vec x'}Z'(\vec
x',t')\quad\quad\forall(\vec x',t')
\in\R^3\times[t_0,+\infty)\quad\quad\quad\text{in the system
$(**)$}\,.
\end{equation}
\end{lemma}
\begin{proof}
By \er{noninchgravortbstrjgghguittu1intmmjhhjhihgggy} and
\er{noninchgravortbstrjkjkjggg33jklkhjhjhuiuiljjljjkkl} we have
\begin{multline}\label{vfyutuyfffhhgfhgfh33iy33hhjgghjhghgjhggh}
\nabla_{\vec x'}Z'(\vec
x',t')=\left(B^{-1}\right)^T\cdot\nabla_{\vec x}\left(Z(\vec
x,t)+\vec w\cdot\left(B\cdot\vec x
\right)+\frac{1}{2}|\vec w|^2t+C_{*}^{**}\right)=\\
B\cdot\nabla_{\vec x}Z(\vec x,t)+B\cdot(B^T\cdot\vec
w)=B\cdot\nabla_{\vec x}Z(\vec x,t)+ \vec w\quad\quad\forall(\vec
x,t) \in\R^3\times[t_0,+\infty).
\end{multline}
On the other hand, the law of transformation of a speed-like vector
field in \er{NoIn5redbstr} of Definition \ref{bggghghgj} with
$A(t):=B$ and $\vec z(t):=(\vec c+\vec w t)$ in the case of
\er{noninchgravortbstrjkjkjggg33jklkhjhjhuiuiljjljjkkl} reeds as
\begin{equation}\label{vfyutuyfffhhgfhgfh33iy33hhjgghjhghgjgghgg}
\vec v'(\vec x',t')=B\cdot \vec v(\vec x,t)+\vec
w\quad\quad\forall(\vec x,t) \in\R^3\times[t_0,+\infty)\,.
\end{equation}
Thus, by \er{vfyutuyfffhhgfhgfh33iy33hhjgghjhghgjhggh} and
\er{vfyutuyfffhhgfhgfh33iy33hhjgghjhghgjgghgg} together with
\er{vfyutuyfffhhgfhgfh33iy33hhjgghgj} we finally deduce
\er{vfyutuyfffhhgfhgfh33iy33hhjgghjhgh}.
\end{proof}
\begin{lemma}\label{yghgjtgyrtrgjgjgjgjgjgjgjgjgjgjgjgjgj33hh}
Let $\mathcal{U}$ be an \underline{extended Galilean group} and let
$(\{0\})$ be some fixed inertial or non-inertial cartesian
coordinate system inside $\mathcal{U}$. Next consider a scalar field
$\gamma:=\gamma(\vec x,t):\R^3\times[t_0,+\infty)\to\R$ associated
\underline{only} with the chosen coordinate system $(\{0\})$. Then
there exist a
\underline{speed-like field generator} on $\mathcal{U}$, denoted by
$Z_{\gamma}:\R^3\times[t_0,+\infty)\to\R$ (and associated with any
inertial or non-inertial cartesian coordinate system inside
$\mathcal{U}$), such that in the \underline{particular} system
$(\{0\})$ we have:
\begin{equation}\label{vfyutuyfffhhgfhgfh33iy33hh}
Z_{\gamma}(\vec x,t)=\gamma(\vec x,t)\quad\quad\forall(\vec x,t)
\in\R^3\times[t_0,+\infty)\,.
\end{equation}
\end{lemma}
\begin{proof}
Similarly to
\er{noninchgravortbstrjkjkjggg33jklkhjhjhuiuiljjljjkkl}, given an
arbitrary inertial or non-inertial cartesian coordinate system
inside $\mathcal{U}$, denoted by the sign $(\{1\})$, the change of
coordinate system $(\{0\})$ to the system $(\{1\})$ can be written
as:
\begin{equation}\label{noninchgravortbstrjkjkbbq133hh}
\begin{cases}
\vec x^{(1)}=B_1\cdot\vec x+\vec c_1+\vec w_1 t,\\
t^{(1)}=t,
\end{cases}
\end{equation}
where $B_1\in SO(3)$ is a constant rotation and $\vec c_1,\vec
w_1\in \mathbb{R}^3$ are constant vectors. Similarly, given another
arbitrary inertial or non-inertial cartesian coordinate system
inside $\mathcal{U}$, denoted by the sign $(\{2\})$, the change of
coordinate system $(\{0\})$ to the system $(\{2\})$ can be written
as:
\begin{equation}\label{noninchgravortbstrjkjkbbq233hh}
\begin{cases}
\vec x^{(2)}=B_2\cdot\vec x+\vec c_2+\vec w_2 t,\\
t^{(2)}=t,
\end{cases}
\end{equation}
where $B_2\in SO(3)$ is another constant rotation and $\vec c_2,\vec
w_2\in \mathbb{R}^3$ are other constant vectors. On the other hand,
the change coordinate system $(\{1\})$ to the system $(\{2\})$ can
be written as:
\begin{equation}\label{noninchgravortbstrjkjkjggg33hh}
\begin{cases}
\vec x^{(2)}=B\cdot\vec x^{(1)}+\vec c+\vec w t^{(1)},\\
t^{(2)}=t^{(1)},
\end{cases}
\end{equation}
where $B\in SO(3)$ is a constant rotation and $\vec c,\vec w\in
\mathbb{R}^3$ are constant vectors. In particular, inserting
\er{noninchgravortbstrjkjkbbq133hh} into
\er{noninchgravortbstrjkjkjggg33hh} gives
\begin{equation}\label{noninchgravortbstrjkjkjggg33hkjh33hh}
\vec x^{(2)}=B\cdot\left(B_1\cdot\vec x+\vec c_1+\vec w_1
t\right)+\vec c+\vec  w t=\left(B\cdot B_1\right)\cdot\vec
x+\left(B\cdot\vec c_1+\vec c\right)+\left(B\cdot\vec w_1 +\vec w
\right)t.
\end{equation}
Thus, comparing \eqref{noninchgravortbstrjkjkjggg33hkjh33hh} with
\er{noninchgravortbstrjkjkbbq233hh} we easily deduce:
\begin{equation}\label{noninchgravortbstrjkjkjggg33hkjh33jklhhk33hh}
B_2=\left(B\cdot B_1\right),\quad\vec c_2=\left(B\cdot\vec c_1+\vec
c\right)\quad\quad\text{and}\quad\quad\vec w_2=\left(B\cdot\vec w_1
+\vec w \right).
\end{equation}

Next, for the system $(\{1\})$ define
\begin{equation}
\label{NoIn5redbstrhkhhk133hh}Z^{(1)}_{\gamma}(\vec
x^{(1)},t^{(1)})=\gamma(\vec x,t)+\vec w_1\cdot\left(B_1\cdot\vec x
\right)+\frac{1}{2}\left|\vec w_1\right|^2t.
\end{equation}
Correspondingly, for the system $(\{2\})$ we have
\begin{equation}
\label{NoIn5redbstrhkhhk233hh}Z^{(2)}_{\gamma}(\vec
x^{(2)},t^{(2)})=\gamma(\vec x,t)+\vec w_2\cdot\left(B_2\cdot\vec x
\right)+\frac{1}{2}\left|\vec w_2\right|^2t.
\end{equation}
Thus, in the \underline{particular} system $(\{0\})$ we have:
\begin{equation}\label{vfyutuyfffhhgfhgfh33iy33hkjhjh33hh}
Z_{\gamma}(\vec x,t)=\gamma(\vec x,t)\quad\quad\forall(\vec x,t)
\in\R^3\times[t_0,+\infty)\,.
\end{equation}
Thus in order to complete the proof we need to show that the scalar
field, defined by \er{NoIn5redbstrhkhhk133hh} and
\er{NoIn5redbstrhkhhk233hh} is a speed-like field generator. Indeed,
by \er{NoIn5redbstrhkhhk133hh} and \er{NoIn5redbstrhkhhk233hh}
together with \er{noninchgravortbstrjkjkjggg33hkjh33jklhhk33hh} and
\er{noninchgravortbstrjkjkbbq133hh} we deduce:
\begin{multline}
\label{NoIn5redbstrhkhhk233jhgjgj33hh}Z^{(2)}_{\gamma}(\vec
x^{(2)},t^{(2)})=\gamma(\vec x,t)+\vec w_2\cdot\left(B_2\cdot\vec x
\right)+\frac{1}{2}\left|\vec w_2\right|^2t \\=\gamma(\vec
x,t)+\left(B\cdot\vec w_1 +\vec w \right)\cdot\left(B\cdot
B_1\cdot\vec x
\right)+\frac{1}{2}\left|B\cdot\vec w_1 +\vec w\right|^2t=\\
\gamma(\vec x,t)+\left(B\cdot\vec w_1\right)\cdot\left(B\cdot \left(
B_1\cdot\vec x
\right)
\right) +\vec w\cdot\left(B\cdot \left( B_1\cdot\vec x
\right)
\right) +\frac{1}{2}\left|B\cdot\vec w_1\right|^2t
+\frac{1}{2}\left|\vec w\right|^2t+\left(\vec w\cdot\left(B\cdot\vec
w_1\right)\right)t\\
= \gamma(\vec x,t)+\vec w_1\cdot\left( B_1\cdot\vec x
\right) +\vec w\cdot\left(B\cdot \left(\vec x^{(1)}-\vec c_1
\right)\right) +\frac{1}{2}\left|\vec w_1\right|^2t
+\frac{1}{2}\left|\vec w\right|^2t
\\=
\left\{\gamma(\vec x,t)+\vec w_1\cdot\left( B_1\cdot\vec x
\right)+\frac{1}{2}\left|\vec w_1\right|^2t\right\} +\vec
w\cdot\left(B\cdot \vec x^{(1)}
\right) +\frac{1}{2}\left|\vec w\right|^2t-\vec
w\cdot\left(B\cdot\vec c_1\right)
\\=
Z^{(1)}_{\gamma}(\vec x^{(1)},t^{(1)})+\vec w\cdot\left(B\cdot \vec
x^{(1)}
\right) +\frac{1}{2}\left|\vec w\right|^2t^{(1)}-\vec
w\cdot\left(B\cdot\vec c_1\right),
\end{multline}
where we also used the identity $(B\cdot \vec x)\cdot(B\cdot \vec
y)=\vec x\cdot\vec y$, which is valid for any $B\in SO(3)$ and any
$\vec x,\vec y\in \mathbb{R}^3$. Thus, comparing
\er{NoIn5redbstrhkhhk233jhgjgj33hh} with
\er{noninchgravortbstrjgghguittu1intmmjhhjhihgggy} in Definition
\ref{gjgjffjjjjjjjjjjjjjjjjjjjjjjhh2}, we deduce that $Z_{\gamma}$
is a \underline{speed-like field generator} on $\mathcal{U}$. This
completes the proof.
\end{proof}

\begin{lemma}\label{ygjhgjgjgjhgjtuiuijk}
Let $Z:=Z(\vec x,t):\R^3\times[t_0,+\infty)\to\R$, be a
\underline{speed-like field generator} on an extended Galilean group
$\mathcal{U}$ and let $(*)$ and $(**)$ be two cartesian coordinate
systems  inside $\mathcal{U}$, so that the change of coordinates
from the system $(*)$ to the system $(**)$ is given by
\begin{equation}\label{noninchgravortbstrjkjkjggg33jklkhjhjhuiuiljjljjkkljkjk}
\begin{cases}
\vec x'=B\cdot\vec x+\vec c+\vec w t,\\
t'=t,
\end{cases}
\end{equation}
with constant rotation $B\in SO(3)$ and constant vectors $\vec
c,\vec w\in \mathbb{R}^3$, and correspondingly we have
\begin{equation}\label{noninchgravortbstrjgghguittu1intmmjhhjhihgggyhhhh}
Z'(\vec x',t')=Z(\vec x,t)+\vec w\cdot\left(B\cdot\vec x
\right)+\frac{1}{2}|\vec w|^2t+C_{*}^{**},
\end{equation}
where $C_{*}^{**}\in\mathbb{R}$ is an unspecified constant. Then we
have
\begin{equation}\label{noninchgravortbstrjgghguittu1intmmjhhjhihgggyhhhhjjjih}
\frac{\partial Z'}{\partial t'}(\vec
x',t')+\frac{1}{2}\left|\nabla_{\vec x'}Z'(\vec
x',t')\right|^2=\frac{\partial Z}{\partial t}(\vec
x,t)+\frac{1}{2}\left|\nabla_{\vec x}Z(\vec
x,t)\right|^2\quad\quad\forall(\vec x,t)
\in\R^3\times[t_0,+\infty)\,.
\end{equation}
\end{lemma}
\begin{proof}
As before, in \er{vfyutuyfffhhgfhgfh33iy33hhjgghjhghgjhggh} we have
\begin{multline}\label{vfyutuyfffhhgfhgfh33iy33hhjgghjhghgjhgghijhhh}
\nabla_{\vec x'}Z'(\vec
x',t')=\left(B^{-1}\right)^T\cdot\nabla_{\vec x}\left(Z(\vec
x,t)+\vec w\cdot\left(B\cdot\vec x
\right)+\frac{1}{2}|\vec w|^2t+C_{*}^{**}\right)=\\
B\cdot\nabla_{\vec x}Z(\vec x,t)+B\cdot(B^T\cdot\vec
w)=B\cdot\nabla_{\vec x}Z(\vec x,t)+ \vec w\quad\quad\forall(\vec
x,t) \in\R^3\times[t_0,+\infty).
\end{multline}
On the other hand, by the Chain rule we have
\begin{multline}\label{vfyutuyfffhhgfhgfh33iy33hhjgghjhghgjhgghijhhhjhihhhhi}
\frac{\partial Z'}{\partial t'}+\vec w\cdot\nabla_{\vec
x'}Z'=\frac{\partial Z'}{\partial t}=\frac{\partial}{\partial
t}\left(Z(\vec x,t)+\vec w\cdot\left(B\cdot\vec x
\right)+\frac{1}{2}|\vec w|^2t+C_{*}^{**}\right)=\frac{\partial
Z}{\partial t}+\frac{1}{2}|\vec w|^2.
\end{multline}
Therefore, by \er{vfyutuyfffhhgfhgfh33iy33hhjgghjhghgjhgghijhhh} and
\er{vfyutuyfffhhgfhgfh33iy33hhjgghjhghgjhgghijhhhjhihhhhi} we have:
\begin{multline}\label{vfyutuyfffhhgfhgfh33iy33hhjgghjhghgjhgghijhhhjhihhhhijhggj}
\frac{\partial Z'}{\partial t'}+\frac{1}{2}\left|\nabla_{\vec
x'}Z'\right|^2=\frac{\partial Z}{\partial t}+\frac{1}{2}|\vec
w|^2-\vec w\cdot\nabla_{\vec x'}Z'+\frac{1}{2}\left|\nabla_{\vec
x'}Z'\right|^2=\frac{\partial Z}{\partial
t}+\frac{1}{2}\left|\nabla_{\vec x'}Z'-\vec
w\right|^2\\=\frac{\partial Z}{\partial
t}+\frac{1}{2}\left|B\cdot\nabla_{\vec x}Z\right|^2=\frac{\partial
Z}{\partial t}+\frac{1}{2}\left|\nabla_{\vec x}Z\right|^2.
\end{multline}
This completes the proof.
\end{proof}

\begin{proposition}\label{jgjjjjjjjjjjjjjjjjjjjjjjjjjjjjjjjjjkhhkh}
\begin{itemize}
\item[{{\bf (i)}}]
Consider the moving continuum medium with velocity field $\vec
u(\vec x,t)$ and let
$\mathcal{S}:=\mathcal{S}(t)\subset\mathbb{R}^3$ be a
two-dimensional surface oriented by the unit normal $\vec n:=\vec
n(\vec x,t)$ and moving together with the given medium. Then, given
a vector field $\vec f(\vec x,t)$ we have:
\begin{equation}\label{MaxVacFullPPNffGGvbccgcu8uuuuukjjkhhhihjhjkhhkmgjhjhhkhjhhjjhjhjjh11ljkjkljkljjjkjkljkjk}
\frac{d}{dt}\left(\iint\vec f\cdot\vec n\,
d\mathcal{S}(t)\right)=\iint\left(\frac{\partial \vec f}{\partial
t}- curl_{\vec x}\left(\vec u\times \vec f\right)+\left({div}_{\vec
x}\vec f\right)\vec u\right)\cdot\vec n\, d\mathcal{S}(t).
\end{equation}

\item[{{\bf (ii)}}]
Consider the moving continuum medium with velocity field $\vec
u(\vec x,t)$ and let $\gamma:=\gamma(t)\subset\mathbb{R}^3$ be a
one-dimensional curve oriented by the unit tangent vector $\vec
t:=\vec t(\vec x,t)$ and moving together with the given medium.
Then, given a vector field $\vec f(\vec x,t)$ we have:
\begin{equation}\label{MaxVacFullPPNffGGvbccgcu8uuuuukjjkhhhihjhjkhhkmgjhjhhkhjhhjjhjhjjh11ljkjkljkljjjkjkljkjk11}
\frac{d}{dt}\left(\int\vec f\cdot\vec t\,
d\gamma(t)\right)=\int\left(\frac{\partial \vec f}{\partial t}- \vec
u\times curl_{\vec x}\vec f+\nabla_{\vec x}\left(\vec u\cdot\vec
f\right)\right)\cdot\vec t\, d\gamma(t).
\end{equation}

\item[{{\bf (iii)}}]
Consider the moving continuum medium with velocity field $\vec
u(\vec x,t)$ and let $\Omega:=\Omega(t)\subset\mathbb{R}^3$ be a
three-dimensional domain moving together with the given medium.
Then, given a scalar field $\psi(\vec x,t)$ we have:
\begin{equation}\label{MaxVacFullPPNffGGvbccgcu8uuuuukjjkhhhihjhjkhhkmgjhjhhkhjhhjjhjhjjh11ljkjkljkljjjkjkljkjk22}
\frac{d}{dt}\left(\iiint\psi\,
d\Omega(t)\right)=\iiint\left(\frac{\partial \psi}{\partial
t}+{div}_{\vec x}\left\{\psi \vec u\right\}\right)\, d\Omega(t).
\end{equation}
\end{itemize}
\end{proposition}
\begin{proof}
Assume that $\mathcal{S}(t)$ is given by the following:
\begin{equation}\label{MaxVacFullPPNffGGvbccgcu8uuuuukjjkhhhihjhjkhhkmgjhjhhkhjhhjjhjhjjh11ljkjkljkljjjkjkljkjkjkkhhkhhjh}
\mathcal{S}(t):=\left\{\vec x\in\mathbb{R}^3\,:\;\vec x=\vec
h(u,v,t),\;(u,v)\in \mathcal{W}\right\},
\end{equation}
where $\mathcal{W}\subset\mathbb{R}^2$ is some flat domain (in the
general case $\mathcal{S}(t)$ is just the union of several parts,
where every part is as in
\er{MaxVacFullPPNffGGvbccgcu8uuuuukjjkhhhihjhjkhhkmgjhjhhkhjhhjjhjhjjh11ljkjkljkljjjkjkljkjkjkkhhkhhjh}).
Then, as it is well known from the calculus, we have
\begin{equation}\label{MaxVacFullPPNffGGvbccgcu8uuuuukjjkhhhihjhjkhhkmgjhjhhkhjhhjjhjhjjh11ljkjkljkljjjkjkljkjkuhjhgj11}
\iint\vec f\cdot\vec n\,
d\mathcal{S}(t)=\iint\limits_\mathcal{W}\vec f\left(\vec
h(u,v,t),t\right)\cdot\left(\frac{\partial\vec h}{\partial
u}(u,v,t)\times \frac{\partial\vec h}{\partial
v}(u,v,t)\right)\,dudv.
\end{equation}
Therefore, differentiating
\er{MaxVacFullPPNffGGvbccgcu8uuuuukjjkhhhihjhjkhhkmgjhjhhkhjhhjjhjhjjh11ljkjkljkljjjkjkljkjkuhjhgj11},
by chain rule we obtain
\begin{multline}\label{MaxVacFullPPNffGGvbccgcu8uuuuukjjkhhhihjhjkhhkmgjhjhhkhjhhjjhjhjjh11ljkjkljkljjjkjkljkjkuhjhgjgdghf}
\frac{d}{dt}\left(\iint\vec f\cdot\vec n\,
d\mathcal{S}(t)\right)=\iint\limits_\mathcal{W}\frac{d}{dt}\left(\vec
f\left(\vec h(u,v,t),t\right)\right)\cdot\left(\frac{\partial\vec
h}{\partial u}(u,v,t)\times \frac{\partial\vec h}{\partial
v}(u,v,t)\right)\,dudv\\+\iint\limits_\mathcal{W}\vec f\left(\vec
h(u,v,t),t\right)\cdot\left(\frac{\partial^2\vec h}{\partial
u\partial t}(u,v,t)\times \frac{\partial\vec h}{\partial
v}(u,v,t)+\frac{\partial\vec h}{\partial u}(u,v,t)\times
\frac{\partial^2\vec h}{\partial v\partial t}(u,v,t)\right)\,dudv=\\
\iint\limits_\mathcal{W}\left(\frac{\partial\vec f}{\partial
t}\left(\vec h(u,v,t),t\right)+d_{\vec x}\vec f\left(\vec
h(u,v,t),t\right)\cdot\frac{\partial\vec h}{\partial
t}(u,v,t)\right)\cdot\left(\frac{\partial\vec h}{\partial
u}(u,v,t)\times \frac{\partial\vec h}{\partial
v}(u,v,t)\right)\,dudv\\+\iint\limits_\mathcal{W}\vec f\left(\vec
h(u,v,t),t\right)\cdot\left(\frac{\partial^2\vec h}{\partial
u\partial t}(u,v,t)\times \frac{\partial\vec h}{\partial
v}(u,v,t)-\frac{\partial^2\vec h}{\partial v\partial
t}(u,v,t)\times\frac{\partial\vec h}{\partial
u}(u,v,t)\right)\,dudv.
\end{multline}
On the other hand, since the surface $\mathcal{S}(t)$ moves together
with the medium with velocity field $\vec u(\vec x,t)$ we obviously
have:
\begin{equation}\label{MaxVacFullPPNffGGvbccgcu8uuuuukjjkhhhihjhjkhhkmgjhjhhkhjhhjjhjhjjh11ljkjkljkljjjkjkljkjkuhjhgj11jgjghghhhkhhkhkhkhk}
\frac{\partial\vec h}{\partial t}(u,v,t)=\vec u\left(\vec
h(u,v,t),t\right).
\end{equation}
Moreover, differentiating
\er{MaxVacFullPPNffGGvbccgcu8uuuuukjjkhhhihjhjkhhkmgjhjhhkhjhhjjhjhjjh11ljkjkljkljjjkjkljkjkuhjhgj11jgjghghhhkhhkhkhkhk}
by $u$ and $v$ and using again the chain rule gives:
\begin{equation}\label{MaxVacFullPPNffGGvbccgcu8uuuuukjjkhhhihjhjkhhkmgjhjhhkhjhhjjhjhjjh11ljkjkljkljjjkjkljkjkuhjhgj11jgjghghhhkhhkhkhkhkgjgggb}
\begin{cases} \frac{\partial^2\vec h}{\partial u\partial t}(u,v,t)=d_{\vec x}\vec u\left(\vec
h(u,v,t),t\right)\cdot \frac{\partial\vec h}{\partial
u}(u,v,t),\\
\frac{\partial^2\vec h}{\partial v\partial t}(u,v,t)=d_{\vec x}\vec
u\left(\vec h(u,v,t),t\right)\cdot \frac{\partial\vec h}{\partial
v}(u,v,t).
\end{cases}
\end{equation}
Thus inserting
\er{MaxVacFullPPNffGGvbccgcu8uuuuukjjkhhhihjhjkhhkmgjhjhhkhjhhjjhjhjjh11ljkjkljkljjjkjkljkjkuhjhgj11jgjghghhhkhhkhkhkhk}
and
\er{MaxVacFullPPNffGGvbccgcu8uuuuukjjkhhhihjhjkhhkmgjhjhhkhjhhjjhjhjjh11ljkjkljkljjjkjkljkjkuhjhgj11jgjghghhhkhhkhkhkhkgjgggb}
into
\er{MaxVacFullPPNffGGvbccgcu8uuuuukjjkhhhihjhjkhhkmgjhjhhkhjhhjjhjhjjh11ljkjkljkljjjkjkljkjkuhjhgjgdghf}
gives:
\begin{multline}\label{MaxVacFullPPNffGGvbccgcu8uuuuukjjkhhhihjhjkhhkmgjhjhhkhjhhjjhjhjjh11ljkjkljkljjjkjkljkjkuhjhgjgdghfkkjkkhhkhkhkhj}
\frac{d}{dt}\left(\iint\vec f\cdot\vec n\, d\mathcal{S}(t)\right)=\\
\iint\limits_\mathcal{W}\left(\frac{\partial\vec f}{\partial
t}\left(\vec h(u,v,t),t\right)+d_{\vec x}\vec f\left(\vec
h(u,v,t),t\right)\cdot\vec u\left(\vec
h(u,v,t),t\right)\right)\cdot\left(\frac{\partial\vec h}{\partial
u}(u,v,t)\times \frac{\partial\vec h}{\partial
v}(u,v,t)\right)\,dudv\\+\iint\limits_\mathcal{W}\vec f\left(\vec
h(u,v,t),t\right)\cdot\left(\left(d_{\vec x}\vec u\left(\vec
h(u,v,t),t\right)\cdot \frac{\partial\vec h}{\partial
u}(u,v,t)\right)\times \frac{\partial\vec h}{\partial
v}(u,v,t)\right)\,dudv
\\-\iint\limits_\mathcal{W}\vec f\left(\vec
h(u,v,t),t\right)\cdot\left(\left(d_{\vec x}\vec u\left(\vec
h(u,v,t),t\right)\cdot \frac{\partial\vec h}{\partial
v}(u,v,t)\right)\times\frac{\partial\vec h}{\partial
u}(u,v,t)\right)\,dudv.
\end{multline}
On the other hand, using \er{apfrm10} we rewrite
\er{MaxVacFullPPNffGGvbccgcu8uuuuukjjkhhhihjhjkhhkmgjhjhhkhjhhjjhjhjjh11ljkjkljkljjjkjkljkjkuhjhgjgdghfkkjkkhhkhkhkhj}
as:
\begin{multline}\label{MaxVacFullPPNffGGvbccgcu8uuuuukjjkhhhihjhjkhhkmgjhjhhkhjhhjjhjhjjh11ljkjkljkljjjkjkljkjkuhjhgjgdghfkkjkkhhkhkhkhjhkjhjhgjg}
\frac{d}{dt}\left(\iint\vec f\cdot\vec n\, d\mathcal{S}(t)\right)=\\
\iint\limits_\mathcal{W}\left(\frac{\partial\vec f}{\partial
t}\left(\vec h(u,v,t),t\right)+d_{\vec x}\vec f\left(\vec
h(u,v,t),t\right)\cdot\vec u\left(\vec
h(u,v,t),t\right)\right)\cdot\left(\frac{\partial\vec h}{\partial
u}(u,v,t)\times \frac{\partial\vec h}{\partial
v}(u,v,t)\right)\,dudv\\+\iint\limits_\mathcal{W}\left(tr\left\{d_{\vec
x}\vec u\left(\vec h(u,v,t),t\right)\right\}\right)\,\vec
f\left(\vec h(u,v,t),t\right)\cdot \left(\frac{\partial\vec
h}{\partial u}(u,v,t)\times \frac{\partial\vec h}{\partial
v}(u,v,t)\right)\,dudv
\\-\iint\limits_\mathcal{W}\vec f\left(\vec
h(u,v,t),t\right)\cdot\left(\left\{d_{\vec x}\vec u\left(\vec
h(u,v,t),t\right)\right\}^T\cdot \left(\frac{\partial\vec
h}{\partial u}(u,v,t)\times\frac{\partial\vec h}{\partial
v}(u,v,t)\right)\right)\,dudv
=\\
\iint\limits_\mathcal{W}\left(\frac{\partial\vec f}{\partial
t}\left(\vec h(u,v,t),t\right)+d_{\vec x}\vec f\left(\vec
h(u,v,t),t\right)\cdot\vec u\left(\vec
h(u,v,t),t\right)\right)\cdot\left(\frac{\partial\vec h}{\partial
u}(u,v,t)\times \frac{\partial\vec h}{\partial
v}(u,v,t)\right)\,dudv\\+\iint\limits_\mathcal{W}\left(div_{\vec
x}\vec u\left(\vec h(u,v,t),t\right)\right)\,\vec f\left(\vec
h(u,v,t),t\right)\cdot \left(\frac{\partial\vec h}{\partial
u}(u,v,t)\times \frac{\partial\vec h}{\partial
v}(u,v,t)\right)\,dudv
\\-\iint\limits_\mathcal{W}\left(d_{\vec x}\vec u\left(\vec
h(u,v,t),t\right)\cdot\vec f\left(\vec
h(u,v,t),t\right)\right)\cdot\left(\frac{\partial\vec h}{\partial
u}(u,v,t)\times\frac{\partial\vec h}{\partial v}(u,v,t)\right)\,dudv
.
\end{multline}
Thus, by \er{apfrm6} we rewrite
\er{MaxVacFullPPNffGGvbccgcu8uuuuukjjkhhhihjhjkhhkmgjhjhhkhjhhjjhjhjjh11ljkjkljkljjjkjkljkjkuhjhgjgdghfkkjkkhhkhkhkhjhkjhjhgjg}
as:
\begin{multline}\label{MaxVacFullPPNffGGvbccgcu8uuuuukjjkhhhihjhjkhhkmgjhjhhkhjhhjjhjhjjh11ljkjkljkljjjkjkljkjkuhjhgjgdghfkkjkkhhkhkhkhjhkjhjhgjggghhfhkh}
\frac{d}{dt}\left(\iint\vec f\cdot\vec n\, d\mathcal{S}(t)\right) =
\iint\limits_\mathcal{W} \left(\frac{\partial \vec f}{\partial
t}\left(\vec h(u,v,t),t\right)\right)\cdot\left(\frac{\partial\vec
h}{\partial u}(u,v,t)\times\frac{\partial\vec h}{\partial
v}(u,v,t)\right)
\,dudv\\
-\iint\limits_\mathcal{W} \left(curl_{\vec x}\left(\vec u\left(\vec
h(u,v,t),t\right)\times \vec f\left(\vec
h(u,v,t),t\right)\right)\right)\cdot\left(\frac{\partial\vec
h}{\partial u}(u,v,t)\times\frac{\partial\vec h}{\partial
v}(u,v,t)\right) \,dudv \\+\iint\limits_\mathcal{W}
\left(\left({div}_{\vec x}\vec f\left(\vec
h(u,v,t),t\right)\right)\vec u\left(\vec
h(u,v,t),t\right)\right)\cdot\left(\frac{\partial\vec h}{\partial
u}(u,v,t)\times\frac{\partial\vec h}{\partial v}(u,v,t)\right)
\,dudv  .
\end{multline}
Therefore, using the reverse direction of
\er{MaxVacFullPPNffGGvbccgcu8uuuuukjjkhhhihjhjkhhkmgjhjhhkhjhhjjhjhjjh11ljkjkljkljjjkjkljkjkuhjhgj11}
in the right hand side of
\er{MaxVacFullPPNffGGvbccgcu8uuuuukjjkhhhihjhjkhhkmgjhjhhkhjhhjjhjhjjh11ljkjkljkljjjkjkljkjkuhjhgjgdghfkkjkkhhkhkhkhjhkjhjhgjggghhfhkh}
gives
\begin{multline}\label{MaxVacFullPPNffGGvbccgcu8uuuuukjjkhhhihjhjkhhkmgjhjhhkhjhhjjhjhjjh11ljkjkljkljjjkjkljkjkuhjhgjgdghfkkjkkhhkhkhkhjhkjhjhgjggghhfhkhkjkhj}
\frac{d}{dt}\left(\iint\vec f\cdot\vec n\, d\mathcal{S}(t)\right) =
\iint\frac{\partial \vec f}{\partial t}\cdot\vec n\, d\mathcal{S}(t)
-\iint\left(curl_{\vec x}\left(\vec u\times \vec
f\right)\right)\cdot\vec n\, d\mathcal{S}(t) +\iint\left({div}_{\vec
x}\vec f\right)\vec u\cdot\vec n\, d\mathcal{S}(t),
\end{multline}
and thus
\er{MaxVacFullPPNffGGvbccgcu8uuuuukjjkhhhihjhjkhhkmgjhjhhkhjhhjjhjhjjh11ljkjkljkljjjkjkljkjk}
follows. So we proved part {\bf (i)} of the Proposition.

Next assume that $\gamma(t)$ is given by the following:
\begin{equation}\label{MaxVacFullPPNffGGvbccgcu8uuuuukjjkhhhihjhjkhhkmgjhjhhkhjhhjjhjhjjh11ljkjkljkljjjkjkljkjkjkkhhkhhjh11}
\gamma(t):=\left\{\vec x\in\mathbb{R}^3\,:\;\vec x=\vec
h(s,t),\;s\in[0,1]\right\},
\end{equation}
(in the general case $\gamma(t)$ is just the union of several parts,
where every part is as in
\er{MaxVacFullPPNffGGvbccgcu8uuuuukjjkhhhihjhjkhhkmgjhjhhkhjhhjjhjhjjh11ljkjkljkljjjkjkljkjkjkkhhkhhjh11}).
Then, as it is well known from the calculus, we have
\begin{equation}\label{MaxVacFullPPNffGGvbccgcu8uuuuukjjkhhhihjhjkhhkmgjhjhhkhjhhjjhjhjjh11ljkjkljkljjjkjkljkjkuhjhgj1111}
\int\vec f\cdot\vec t\, d\gamma(t)=\int_0^1\vec f\left(\vec
h(s,t),t\right)\cdot\frac{\partial\vec h}{\partial s}(s,t)\,ds.
\end{equation}
Therefore, differentiating
\er{MaxVacFullPPNffGGvbccgcu8uuuuukjjkhhhihjhjkhhkmgjhjhhkhjhhjjhjhjjh11ljkjkljkljjjkjkljkjkuhjhgj1111},
by chain rule we obtain
\begin{multline}\label{MaxVacFullPPNffGGvbccgcu8uuuuukjjkhhhihjhjkhhkmgjhjhhkhjhhjjhjhjjh11ljkjkljkljjjkjkljkjkuhjhgjgdghf1111}
\frac{d}{dt}\left(\int\vec f\cdot\vec t\,
d\gamma(t)\right)=\int_0^1\frac{d}{dt}\left(\vec f\left(\vec
h(s,t),t\right)\right)\cdot\frac{\partial\vec h}{\partial
s}(s,t)\,ds+\int_0^1\vec f\left(\vec
h(s,t),t\right)\cdot\frac{\partial^2\vec h}{\partial s\partial
t}(s,t)\,ds=\\
\int_0^1\left(\frac{\partial\vec f}{\partial t}\left(\vec
h(s,t),t\right)+d_{\vec x}\vec f\left(\vec
h(s,t),t\right)\cdot\frac{\partial\vec h}{\partial
t}(s,t)\right)\cdot\frac{\partial\vec h}{\partial
s}(s,t)\,ds+\int_0^1\vec f\left(\vec
h(s,t),t\right)\cdot\frac{\partial^2\vec h}{\partial s\partial
t}(s,t)\,ds.
\end{multline}
On the other hand, since the curve $\gamma(t)$ moves together with
the medium with velocity field $\vec u(\vec x,t)$ we obviously have:
\begin{equation}\label{MaxVacFullPPNffGGvbccgcu8uuuuukjjkhhhihjhjkhhkmgjhjhhkhjhhjjhjhjjh11ljkjkljkljjjkjkljkjkuhjhgj11jgjghghhhkhhkhkhkhk11}
\frac{\partial\vec h}{\partial t}(s,t)=\vec u\left(\vec
h(s,t),t\right).
\end{equation}
Moreover, differentiating
\er{MaxVacFullPPNffGGvbccgcu8uuuuukjjkhhhihjhjkhhkmgjhjhhkhjhhjjhjhjjh11ljkjkljkljjjkjkljkjkuhjhgj11jgjghghhhkhhkhkhkhk11}
by $s$ and using again the chain rule gives:
\begin{equation}\label{MaxVacFullPPNffGGvbccgcu8uuuuukjjkhhhihjhjkhhkmgjhjhhkhjhhjjhjhjjh11ljkjkljkljjjkjkljkjkuhjhgj11jgjghghhhkhhkhkhkhkgjgggb11}
\frac{\partial^2\vec h}{\partial s\partial t}(s,t)=d_{\vec x}\vec
u\left(\vec h(s,t),t\right)\cdot \frac{\partial\vec h}{\partial
s}(s,t).
\end{equation}
Thus inserting
\er{MaxVacFullPPNffGGvbccgcu8uuuuukjjkhhhihjhjkhhkmgjhjhhkhjhhjjhjhjjh11ljkjkljkljjjkjkljkjkuhjhgj11jgjghghhhkhhkhkhkhk11}
and
\er{MaxVacFullPPNffGGvbccgcu8uuuuukjjkhhhihjhjkhhkmgjhjhhkhjhhjjhjhjjh11ljkjkljkljjjkjkljkjkuhjhgj11jgjghghhhkhhkhkhkhkgjgggb11}
into
\er{MaxVacFullPPNffGGvbccgcu8uuuuukjjkhhhihjhjkhhkmgjhjhhkhjhhjjhjhjjh11ljkjkljkljjjkjkljkjkuhjhgjgdghf1111}
gives:
\begin{multline}\label{MaxVacFullPPNffGGvbccgcu8uuuuukjjkhhhihjhjkhhkmgjhjhhkhjhhjjhjhjjh11ljkjkljkljjjkjkljkjkuhjhgjgdghfkkjkkhhkhkhkhj11}
\frac{d}{dt}\left(\int\vec f\cdot\vec t\, d\gamma(t)\right)=
\int_0^1\left(\frac{\partial\vec f}{\partial t}\left(\vec
h(s,t),t\right)+d_{\vec x}\vec f\left(\vec h(s,t),t\right)\cdot\vec
u\left(\vec h(s,t),t\right)\right)\cdot\frac{\partial\vec
h}{\partial s}(s,t)\,ds\\+\int_0^1\vec f\left(\vec
h(s,t),t\right)\cdot\left(d_{\vec x}\vec u\left(\vec
h(s,t),t\right)\cdot \frac{\partial\vec h}{\partial
s}(s,t)\right)\,ds\\= \int_0^1\left(\frac{\partial\vec f}{\partial
t}\left(\vec h(s,t),t\right)+d_{\vec x}\vec f\left(\vec
h(s,t),t\right)\cdot\vec u\left(\vec
h(s,t),t\right)\right)\cdot\frac{\partial\vec h}{\partial
s}(s,t)\,ds\\+\int_0^1\left(\left\{d_{\vec x}\vec u\left(\vec
h(s,t),t\right)\right\}^T\cdot \vec f\left(\vec
h(s,t),t\right)\right)\cdot\frac{\partial\vec h}{\partial
s}(s,t)\,ds.
\end{multline}
On the other hand, using \er{apfrm9} we rewrite
\er{MaxVacFullPPNffGGvbccgcu8uuuuukjjkhhhihjhjkhhkmgjhjhhkhjhhjjhjhjjh11ljkjkljkljjjkjkljkjkuhjhgjgdghfkkjkkhhkhkhkhj11}
as:
\begin{multline}\label{MaxVacFullPPNffGGvbccgcu8uuuuukjjkhhhihjhjkhhkmgjhjhhkhjhhjjhjhjjh11ljkjkljkljjjkjkljkjkuhjhgjgdghfkkjkkhhkhkhkhjhkjhjhgjggghhfhkh11}
\frac{d}{dt}\left(\int\vec f\cdot\vec t\, d\gamma(t)\right)=
\int_0^1\left(\frac{\partial\vec f}{\partial t}\left(\vec
h(s,t),t\right)-\vec u\left(\vec h(s,t),t\right)\times curl_{\vec
x}\vec f\left(\vec h(s,t),t\right)\right)\cdot\frac{\partial\vec
h}{\partial s}(s,t)\,ds\\+\int_0^1\left(\left\{d_{\vec x}\vec
f\left(\vec h(s,t),t\right)\right\}^T\cdot \vec u\left(\vec
h(s,t),t\right)+\left\{d_{\vec x}\vec u\left(\vec
h(s,t),t\right)\right\}^T\cdot \vec f\left(\vec
h(s,t),t\right)\right)\cdot\frac{\partial\vec h}{\partial
s}(s,t)\,ds\\
= \int_0^1\left(\frac{\partial\vec f}{\partial t}\left(\vec
h(s,t),t\right)-\vec u\left(\vec h(s,t),t\right)\times curl_{\vec
x}\vec f\left(\vec h(s,t),t\right)\right)\cdot\frac{\partial\vec
h}{\partial s}(s,t)\,ds\\+\int_0^1\nabla_{\vec x}\left(\vec
u\left(\vec h(s,t),t\right)\cdot \vec f\left(\vec
h(s,t),t\right)\right)\cdot\frac{\partial\vec h}{\partial
s}(s,t)\,ds.
\end{multline}
Therefore, using the reverse direction of
\er{MaxVacFullPPNffGGvbccgcu8uuuuukjjkhhhihjhjkhhkmgjhjhhkhjhhjjhjhjjh11ljkjkljkljjjkjkljkjkuhjhgj1111}
in the right hand side of
\er{MaxVacFullPPNffGGvbccgcu8uuuuukjjkhhhihjhjkhhkmgjhjhhkhjhhjjhjhjjh11ljkjkljkljjjkjkljkjkuhjhgjgdghfkkjkkhhkhkhkhjhkjhjhgjggghhfhkh11}
gives
\begin{equation}\label{MaxVacFullPPNffGGvbccgcu8uuuuukjjkhhhihjhjkhhkmgjhjhhkhjhhjjhjhjjh11ljkjkljkljjjkjkljkjkuhjhgjgdghfkkjkkhhkhkhkhjhkjhjhgjggghhfhkhkjkhj11}
\frac{d}{dt}\left(\int\vec f\cdot\vec t\, d\gamma(t)\right) =
\int\left(\frac{\partial\vec f}{\partial t}-\vec u\times curl_{\vec
x}\vec f\right)\cdot\vec t\, d\gamma(t)+\int\nabla_{\vec
x}\left(\vec u\cdot \vec f\right)\cdot\vec t\, d\gamma(t) ,
\end{equation}
and thus
\er{MaxVacFullPPNffGGvbccgcu8uuuuukjjkhhhihjhjkhhkmgjhjhhkhjhhjjhjhjjh11ljkjkljkljjjkjkljkjk11}
follows. So we proved part {\bf(ii)} of the Proposition.

Next assume that $\Omega(t)$ is given by the following:
\begin{equation}\label{MaxVacFullPPNffGGvbccgcu8uuuuukjjkhhhihjhjkhhkmgjhjhhkhjhhjjhjhjjh11ljkjkljkljjjkjkljkjkjkkhhkhhjh22}
\Omega(t):=\left\{\vec x\in\mathbb{R}^3\,:\;\vec x=\vec h(\vec
y,t),\;\vec y\in G\right\},
\end{equation}
where $G\subset\mathbb{R}^3$ is some three-dimensional domain. Then,
as it is well known from the rule of the change of variables of
integration from the calculus, we have
\begin{equation}\label{MaxVacFullPPNffGGvbccgcu8uuuuukjjkhhhihjhjkhhkmgjhjhhkhjhhjjhjhjjh11ljkjkljkljjjkjkljkjkuhjhgj1122}
\iiint\psi\, d\Omega(t)=\iiint\limits_G\vec \psi\left(\vec
h(u,v,t),t\right)\left|det\left\{d_{\vec y}\vec h(\vec
y,t)\right\}\right|\,d\vec y.
\end{equation}

Therefore, differentiating
\er{MaxVacFullPPNffGGvbccgcu8uuuuukjjkhhhihjhjkhhkmgjhjhhkhjhhjjhjhjjh11ljkjkljkljjjkjkljkjkuhjhgj1122},
by chain rule we obtain
\begin{multline}\label{MaxVacFullPPNffGGvbccgcu8uuuuukjjkhhhihjhjkhhkmgjhjhhkhjhhjjhjhjjh11ljkjkljkljjjkjkljkjkuhjhgjgdghf1122}
\frac{d}{dt}\left(\iiint\psi\,
d\Omega(t)\right)=\iiint\limits_G\frac{d}{dt}\left(\vec
\psi\left(\vec h(\vec y,t),t\right)\right)\left|det\left\{d_{\vec
y}\vec h(\vec y,t)\right\}\right|\,d\vec y\\+\iiint\limits_G\vec
\psi\left(\vec h(\vec y,t),t\right)\frac{\partial}{
\partial
t}\left\{\left|det\left\{d_{\vec y}\vec h(\vec
y,t)\right\}\right|\right\}\,d\vec y=\\
\iiint\limits_G\left(\frac{\partial\vec \psi}{\partial t}\left(\vec
h(\vec y,t),t\right)+\nabla_{\vec x}\vec \psi\left(\vec h(\vec
y,t),t\right)\cdot\frac{\partial\vec h}{\partial t}(\vec
y,t)\right)\left|det\left\{d_{\vec y}\vec h(\vec
y,t)\right\}\right|\,d\vec y\\+\iiint\limits_G\vec \psi\left(\vec
h(\vec y,t),t\right)\frac{\partial}{
\partial
t}\left\{\left|det\left\{d_{\vec y}\vec h(\vec
y,t)\right\}\right|\right\}\,d\vec y.
\end{multline}
On the other hand, since the domain $\Omega(t)$ moves together with
the medium with velocity field $\vec u(\vec x,t)$ we obviously have:
\begin{equation}\label{MaxVacFullPPNffGGvbccgcu8uuuuukjjkhhhihjhjkhhkmgjhjhhkhjhhjjhjhjjh11ljkjkljkljjjkjkljkjkuhjhgj11jgjghghhhkhhkhkhkhk22}
\frac{\partial\vec h}{\partial t}(\vec y,t)=\vec u\left(\vec h(\vec
y,t),t\right).
\end{equation}
Moreover, differentiating
\er{MaxVacFullPPNffGGvbccgcu8uuuuukjjkhhhihjhjkhhkmgjhjhhkhjhhjjhjhjjh11ljkjkljkljjjkjkljkjkuhjhgj11jgjghghhhkhhkhkhkhk22}
by $\vec y$ and using again the chain rule gives:
\begin{equation}\label{MaxVacFullPPNffGGvbccgcu8uuuuukjjkhhhihjhjkhhkmgjhjhhkhjhhjjhjhjjh11ljkjkljkljjjkjkljkjkuhjhgj11jgjghghhhkhhkhkhkhkgjgggbkk22}
\frac{\partial}{\partial t}\left\{d_{\vec y}\vec h(\vec
y,t)\right\}=d_{\vec x}\vec u\left(\vec h(\vec y,t),t\right)\cdot
\left\{d_{\vec y}\vec h(\vec y,t)\right\}.
\end{equation}
In particular, using the Liouville Theorem in the theory of linear
ordinary differential systems by
\er{MaxVacFullPPNffGGvbccgcu8uuuuukjjkhhhihjhjkhhkmgjhjhhkhjhhjjhjhjjh11ljkjkljkljjjkjkljkjkuhjhgj11jgjghghhhkhhkhkhkhkgjgggbkk22}
we deduce that $det\left\{d_{\vec y}\vec h(\vec y,t)\right\}$
satisfies
\begin{equation}\label{MaxVacFullPPNffGGvbccgcu8uuuuukjjkhhhihjhjkhhkmgjhjhhkhjhhjjhjhjjh11ljkjkljkljjjkjkljkjkuhjhgj11jgjghghhhkhhkhkhkhkgjgggbzzz22}
\frac{\partial}{\partial t}\left(det\left\{d_{\vec y}\vec h(\vec
y,t)\right\}\right)=\left(tr\left\{d_{\vec x}\vec u\left(\vec h(\vec
y,t),t\right)\right\}\right) \left(det\left\{d_{\vec y}\vec h(\vec
y,t)\right\}\right)=\left(div_{\vec x}\vec u\left(\vec h(\vec
y,t),t\right)\right) \left(det\left\{d_{\vec y}\vec h(\vec
y,t)\right\}\right),
\end{equation}
and thus,
\begin{equation}\label{MaxVacFullPPNffGGvbccgcu8uuuuukjjkhhhihjhjkhhkmgjhjhhkhjhhjjhjhjjh11ljkjkljkljjjkjkljkjkuhjhgj11jgjghghhhkhhkhkhkhkgjgggb22}
\frac{\partial}{\partial t}\left(\left|det\left\{d_{\vec y}\vec
h(\vec y,t)\right\}\right|\right)=\left(div_{\vec x}\vec u\left(\vec
h(\vec y,t),t\right)\right) \left|det\left\{d_{\vec y}\vec h(\vec
y,t)\right\}\right|.
\end{equation}
Thus inserting
\er{MaxVacFullPPNffGGvbccgcu8uuuuukjjkhhhihjhjkhhkmgjhjhhkhjhhjjhjhjjh11ljkjkljkljjjkjkljkjkuhjhgj11jgjghghhhkhhkhkhkhk22}
and
\er{MaxVacFullPPNffGGvbccgcu8uuuuukjjkhhhihjhjkhhkmgjhjhhkhjhhjjhjhjjh11ljkjkljkljjjkjkljkjkuhjhgj11jgjghghhhkhhkhkhkhkgjgggb22}
into
\er{MaxVacFullPPNffGGvbccgcu8uuuuukjjkhhhihjhjkhhkmgjhjhhkhjhhjjhjhjjh11ljkjkljkljjjkjkljkjkuhjhgjgdghf1122}
gives:
\begin{multline}\label{MaxVacFullPPNffGGvbccgcu8uuuuukjjkhhhihjhjkhhkmgjhjhhkhjhhjjhjhjjh11ljkjkljkljjjkjkljkjkuhjhgjgdghfkkjkkhhkhkhkhj22}
\frac{d}{dt}\left(\iiint\psi\, d\Omega(t)\right)=
\iiint\limits_G\left(\frac{\partial\vec \psi}{\partial t}\left(\vec
h(\vec y,t),t\right)+\vec u\left(\vec h(\vec
y,t),t\right)\cdot\nabla_{\vec x}\vec\psi\left(\vec h(\vec
y,t),t\right)\right)\left|det\left\{d_{\vec y}\vec h(\vec
y,t)\right\}\right|\,d\vec y\\+\iiint\limits_G\vec \psi\left(\vec
h(\vec y,t),t\right)\left(div_{\vec x}\vec u\left(\vec h(\vec
y,t),t\right)\right) \left|det\left\{d_{\vec y}\vec h(\vec
y,t)\right\}\right|\,d\vec y.
\end{multline}
Therefore, using the reverse direction of
\er{MaxVacFullPPNffGGvbccgcu8uuuuukjjkhhhihjhjkhhkmgjhjhhkhjhhjjhjhjjh11ljkjkljkljjjkjkljkjkuhjhgj1122}
in the right hand side of
\er{MaxVacFullPPNffGGvbccgcu8uuuuukjjkhhhihjhjkhhkmgjhjhhkhjhhjjhjhjjh11ljkjkljkljjjkjkljkjkuhjhgjgdghfkkjkkhhkhkhkhj22}
gives
\begin{multline}\label{MaxVacFullPPNffGGvbccgcu8uuuuukjjkhhhihjhjkhhkmgjhjhhkhjhhjjhjhjjh11ljkjkljkljjjkjkljkjkuhjhgjgdghfkkjkkhhkhkhkhj22khhhkhhhj22}
\frac{d}{dt}\left(\iiint\psi\, d\Omega(t)\right)=
\iiint\left(\frac{\partial\vec \psi}{\partial t}+\vec
u\cdot\nabla_{\vec x}\psi+(div_{\vec x}\vec u)\psi\right)\,
d\Omega(t)\\= \iiint\left(\frac{\partial\vec \psi}{\partial
t}+div_{\vec x}\left\{\psi\vec u\right\}\right)\, d\Omega(t),
\end{multline}
and thus
\er{MaxVacFullPPNffGGvbccgcu8uuuuukjjkhhhihjhjkhhkmgjhjhhkhjhhjjhjhjjh11ljkjkljkljjjkjkljkjk22}
follows. So we proved part {\bf(iii)} of the Proposition.
%
%
%
\end{proof}

\begin{lemma}\label{khhjgjgggh}
The matrix valued field $T:=T(\vec
x,t):\R^3\times[t_0,+\infty)\to\R^{3\times 3}$ is a proper matrix
field if and only if, under every change of coordinate system given
by \er{noninchgravortbstr}, this field transforms as:
\begin{equation}\label{uguyytfddddgghjjghjjj11zz}
T'(\vec x',t')=\left(A(t)\otimes A(t)\right)\cdot T(\vec x,t),
\end{equation}
where the sign $\otimes$ in \er{uguyytfddddgghjjghjjj11zz} means the
tensor product of the matrices, i.e. for given two linear operators
(matrices) $A\in\mathbb{R}^{3\times 3}$ and $B\in
\mathbb{R}^{3\times 3}$
their tensor product
$A\otimes B$ is a linear operator from $\mathbb{R}^{3\times 3}$ to
$\mathbb{R}^{3\times 3}$, defined by the identity:
\begin{equation}\label{fyfgjguyiuiHHHHHHkjjjjlzz}
(A\otimes B)\cdot(\vec a\otimes \vec b)=(A\cdot \vec a)\otimes
(B\cdot \vec b)\quad\quad\forall \vec a\in\mathbb{R}^{3},\;\forall
\vec b\in\mathbb{R}^{3}.
\end{equation}
\end{lemma}
\begin{proof}
Since we identify a vector $\vec c\in \mathbb{R}^{3}$ with the
corresponding $3\times 1$ matrix, clearly we have
\begin{equation}\label{gjgghhikkyy}
\vec a\otimes \vec b=\vec a\cdot\vec b^T\quad\quad\forall \vec
a\in\mathbb{R}^{3},\;\forall \vec b\in\mathbb{R}^{3}.
\end{equation}
Thus, inserting \er{gjgghhikkyy} into \er{fyfgjguyiuiHHHHHHkjjjjlzz}
gives:
\begin{multline}\label{fyfgjguyiuiHHHHHHkjjjjlzzgjgghghzzaazz}
(A\otimes A)\cdot(\vec a\otimes \vec b)=(A\cdot \vec a)\otimes
(A\cdot \vec b)=(A\cdot \vec a)\cdot(A\cdot \vec b)^T=(A\cdot \vec
a)\cdot(\vec b^T\cdot A^T)=A\cdot(\vec a\cdot\vec b^T)\cdot
A^T\\=A\cdot(\vec a\otimes\vec b)\cdot A^T\quad\quad\forall \vec
a\in\mathbb{R}^{3},\;\forall \vec b\in\mathbb{R}^{3}.
\end{multline}
Then, by the linearity, \er{fyfgjguyiuiHHHHHHkjjjjlzzgjgghghzzaazz}
gives:
\begin{equation}\label{fyfgjguyiuiHHHHHHkjjjjlzzgjgghghzz}
(A\otimes A)\cdot G=A\cdot G\cdot A^T\quad\quad\forall\,
G\in\mathbb{R}^{3\times 3}.
\end{equation}
Thus, by \er{fyfgjguyiuiHHHHHHkjjjjlzzgjgghghzz} we deduce that
\er{uguyytfddddgghjjghjjj11zz} is equivalent to
\er{uguyytfddddgghjjghjjj}.
\end{proof}

\section{Gravity revised}\label{gugyu} Consider the classical space-time
where the change of some inertial coordinate system $(*)$ to another
inertial coordinate system $(**)$ is given (up to equivalence) by
the Galilean Transformation:
\begin{equation}\label{noninchgravortbstrjgghguittu1}
\begin{cases}
\vec x'=\vec x+\vec wt,\\
t'=t,
\end{cases}
\end{equation}
and the change of some non-inertial cartesian coordinate system
$(*)$ to another cartesian coordinate system $(**)$ is of the form:
\begin{equation}\label{noninchgravortbstrjgghguittu2}
\begin{cases}
\vec x'=A(t)\cdot\vec x+\vec z(t),\\
t'=t,
\end{cases}
\end{equation}
where $A(t)\in SO(3)$ is a rotation, i.e. $A(t)\in \R^{3\times 3}$,
$det\, A(t)>0$ and $A(t)\cdot A^T(t)=I$, where $A^T$ is the
transpose of the matrix $A$.

 Similarly to the General Theory of Relativity, we assume that
the most general laws of Classical Mechanics should be invariant in
every non-inertial cartesian coordinate system, i.e. they preserve
their form under transformations of the form
\er{noninchgravortbstrjgghguittu2}. Moreover, again as in the
General Theory of Relativity, we assume that the fictitious forces
(inertial forces) in non-inertial coordinate systems and the forces
of Newtonian gravitation have the same nature and represented by
some field in somewhat similar to the Electromagnetic field.

 We begin with some simple observation. Assume that we are away from
essential gravitational masses and strong electromagnetic fields.
Then consider two cartesian coordinate systems $(*)$ and $(**)$,
such that the system $(**)$ is inertial and the change of coordinate
system $(*)$ to coordinate system $(**)$ is given by
\er{noninchgravortbstrjgghguittu2}. Then the
fictitious-gravitational force in the system $(**)$ is trivial $\vec
F'_0=0$. On the other hand, since under the change of coordinate
system of the form \er{noninchgravortbstrjgghguittu2} the velocity
transforms as
\begin{equation}\label{noninchgravortbstrjgghguittu2gjg}
\vec u'=A(t)\cdot\vec u+\frac{dA}{dt}(t)\cdot\vec x+\frac{d\vec
z}{dt}(t)
\end{equation}
and the acceleration
transforms as
\begin{equation}\label{noninchgravortbstrjgghguittu2gjgghhjhg}
\vec a'=A(t)\cdot\vec a+2\frac{dA}{dt}(t)\cdot\vec
u+\frac{d^2A}{dt^2}(t)\cdot\vec x+\frac{d^2\vec z}{dt^2}(t)
\end{equation}
the fictitious-gravitational force in the system $(*)$, acting on
the particle with inertial mass $m$ and velocity $\vec u$, is given
by
\begin{equation}\label{noninchgravortbstrjgghguittu2gjgghhjhghjhjgg}
\vec F_0=m\left(-2A^T(t)\cdot\frac{dA}{dt}(t)\cdot\vec
u-A^T(t)\cdot\frac{d^2 A}{dt^2}(t)\cdot\vec
x-A^T(t)\cdot\frac{d^2\vec z}{dt^2}(t)\right).
\end{equation}
On the other hand, since $A(t)\cdot A^T(t)=I$ and thus
$A^T(t)\cdot\frac{dA}{dt}(t)+\frac{dA^T}{dt}(t)\cdot A(t)=0$, if we
define a vector field
\begin{equation}\label{noninchgravortbstrjgghguittu2gjgjhjhhklk}
\vec v(\vec x,t):=-A^T(t)\cdot\frac{dA}{dt}(t)\cdot\vec
x-A^T(t)\cdot\frac{d\vec z}{dt}(t),
\end{equation}
then we obviously have
\begin{equation}\label{noninchgravortbstrjgghguittu2gjgjhjhhklkhgffgfg}
\begin{cases}
d_{\vec x}\vec
v=-A^T(t)\cdot\frac{dA}{dt}(t)=\frac{dA^T}{dt}(t)\cdot
A(t)
\\
\left\{d_{\vec x}\vec v\right\}^T=-\frac{dA^T}{dt}(t)\cdot
A(t)=A^T(t)\cdot\frac{dA}{dt}(t)
\\
\frac{\partial\vec v}{\partial
t}=-A^T(t)\cdot\left(\frac{d^2A}{dt^2}(t)\cdot\vec x+\frac{d^2\vec
z}{dt^2}(t)\right)-\frac{dA^T}{dt}(t)\cdot\left(\frac{dA}{dt}(t)\cdot\vec
x+\frac{d\vec z}{dt}(t)\right)
\end{cases}
\end{equation}
Thus by \er{noninchgravortbstrjgghguittu2gjgjhjhhklk} and
\er{noninchgravortbstrjgghguittu2gjgjhjhhklkhgffgfg} we rewrite
\er{noninchgravortbstrjgghguittu2gjgghhjhghjhjgg} as
\begin{equation}\label{noninchgravortbstrjgghguittu2gjgghhjhghjhjgghgghghgh}
\vec F_0=m\left(-2A^T(t)\cdot\frac{dA}{dt}(t)\cdot\vec
u+\frac{\partial\vec v}{\partial t}-\frac{dA^T}{dt}(t)\cdot
A(t)\cdot \vec v\right).
\end{equation}
Then using \er{apfrm9} and
\er{noninchgravortbstrjgghguittu2gjgjhjhhklkhgffgfg} we finally
rewrite \er{noninchgravortbstrjgghguittu2gjgghhjhghjhjgghgghghgh} as
\begin{equation}\label{noninchgravortbstrjgghguittu2gjgghhjhghjhjgghgghghghtytyt}
\vec F_0=m\left(\frac{\partial\vec v}{\partial
t}+\frac{1}{2}\nabla_{\vec x}\left(|\vec v|^2\right)\right)+m\vec
u\times \left(-curl_{\vec x}\vec v\right).
\end{equation}

 Similarly assume that also in the general case of essential
gravitational masses there exists a vector field $\vec v(\vec x,t)$
such that in some inertial or non-inertial cartesian coordinate
system the fictitious-gravitational force is given by
\er{noninchgravortbstrjgghguittu2gjgghhjhghjhjgghgghghghtytyt}. Then
we call the vector field $\vec v$ the vectorial gravitational
potential. We see here the following analogy with Electrodynamics:
denoting
\begin{equation*}
\tilde {\vec E}:=\partial_{t}\vec v+\nabla_{\vec
x}\left(\frac{1}{2}|\vec v|^2\right)\quad\text{and}\quad \tilde
{\vec B}:=-c\, curl_{\vec x}\vec v,
\end{equation*}
we rewrite
\er{noninchgravortbstrjgghguittu2gjgghhjhghjhjgghgghghghtytyt} as
\begin{equation}\label{MaxVacFull1ninshtrgravortjhhjfhfhNewhjh}
\vec F_0=m\left(\tilde {\vec E}+\frac{1}{c}\vec u\times\tilde {\vec
B}\right),
\end{equation}
where
\begin{equation*}
curl_{\vec x}\tilde {\vec E}+\frac{1}{c}\frac{\partial}{\partial
t}\tilde {\vec B}=0\quad\text{and}\quad div_{\vec x}\tilde {\vec
B}=0.
\end{equation*}

Next using
\er{noninchgravortbstrjgghguittu2gjgghhjhghjhjgghgghghghtytyt} we
rewrite the Second Law of Newton as
\begin{equation}\label{noninchgravortbstrjgghguittu2gjgghhjhghjhjgghgghghghtytythvfghfgghjgg}
m\frac{d^2\vec x}{dt^2}=m\frac{d\vec u}{dt}=\vec F_0+\vec
F=m\left(\frac{\partial\vec v}{\partial t}(\vec
x,t)+\frac{1}{2}\nabla_{\vec x}\left(\left|\vec
v\right|^2\right)(\vec x,t)\right)+m\vec u\times \left(-curl_{\vec
x}\vec v(\vec x,t)\right)+\vec F,
\end{equation}
where $\vec x:=\vec x(t)$, $\vec u:=\vec u(t)=\frac{d\vec x}{dt}(t)$
and $m$ are the place, the velocity and the inertial mass of some
given particle at the moment of time $t$, $\vec v:=\vec v(\vec x,t)$
is the vectorial gravitational potential and $\vec F$ is the total
non-gravitational force, acting on the given particle.

Once we considered the Second Law of Newton in the form
\begin{equation}\label{MaxVacFull1ninshtrgravorthjhjlhhjPPN}
\frac{d\vec u}{dt}=-\vec u\times curl_{\vec x}\vec
v+\partial_{t}\vec v+\nabla_{\vec x}\left(\frac{1}{2}|\vec
v|^2\right)+\frac{1}{m}\vec F,
\end{equation}
we still need to prove that this law is invariant under the change
of inertial or non-inertial cartesian coordinate system and to
determine the law of transformation for the vectorial-gravitational
potential under the change of coordinate systems. As we will show
above this is indeed the case and moreover, the law of
transformation of the vectorial gravitational potential, under the
change of coordinate system, given by
\er{noninchgravortbstrjgghguittu2}, is:
$$\vec v'=A(t)\cdot \vec
v+\frac{dA}{dt}(t)\cdot\vec x+\frac{d\vec z}{dt}(t)$$ i.e. it is the
same as the transformation of a field of velocities. More precisely
we have the following:
\begin{proposition}\label{gjghghgghg}
Consider the change of some non-inertial cartesian coordinate system
$(*)$ to another cartesian coordinate system $(**)$ of the form:
\begin{equation}\label{noninchgravortPPNnnn}
\begin{cases}
\vec x'=A(t)\cdot\vec x+\vec z(t),\\
t'=t,
\end{cases}
\end{equation}
where $A(t)\in SO(3)$ is a rotation, i.e. $A(t)\in \R^{3\times 3}$,
$det\, A(t)>0$ and $A(t)\cdot A^T(t)=I$. Next, assume that in the
coordinate system $(**)$ we observe a validity of the Second Law of
Newton in the form:
\begin{equation}\label{MaxVacFull1ninshtrgravortPPN}
\frac{d\vec u'}{dt'}=-\vec u'\times curl_{\vec x'}\vec
v'+\partial_{t'}\vec v'+\nabla_{\vec x'}\left(\frac{1}{2}|\vec
v'|^2\right)+\frac{1}{m'}\vec F',
\end{equation}
where $\vec x':=\vec x'(t')$, $\vec u':=\vec u'(t')=\frac{d\vec
x'}{dt'}(t')$ and $m'$ are the place, the velocity and the mass of
some given particle at the moment of time $t'$, $\vec v':=\vec
v'(\vec x',t')$ is the vectorial gravitational potential and $\vec
F'$ is a total non-gravitational force, acting on the given particle
in the coordinate system $(**)$. Then in the coordinate system $(*)$
we observe a validity of the Second Law of Newton in the (same as
\er{MaxVacFull1ninshtrgravortPPN}) form:
\begin{equation}\label{MaxVacFull1ninshtrgravortjhhjPPNjffjf}
\frac{d\vec u}{dt}=-\vec u\times curl_{\vec x}\vec
v+\partial_{t}\vec v+\nabla_{\vec x}\left(\frac{1}{2}|\vec
v|^2\right)+\frac{1}{m}\vec F,
\end{equation}
where
\begin{align}
\label{NoIn5gravortPPN11}\vec v'=A(t)\cdot \vec
v+\frac{dA}{dt}(t)\cdot\vec x+\frac{d\vec z}{dt}(t)\\
\label{NoIn1gravortPPN11}\vec F'=A(t)\cdot\vec F,\\
\label{NoIn2gravortPPN11}m'=m,\\
\label{NoIn3gravortPPN11}\vec u'=A(t)\cdot \vec
u+\frac{dA}{dt}(t)\cdot\vec x+\frac{d\vec z}{dt}(t).
\end{align}
\end{proposition}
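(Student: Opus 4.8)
The plan is to recast the whole equation of motion as the invariance of a single vector quantity. Define, in every coordinate system,
$$\vec G:=\frac{d\vec u}{dt}+\vec u\times curl_{\vec x}\vec v-\partial_{t}\vec v-\frac12\nabla_{\vec x}\left(|\vec v|^2\right),$$
so that \er{MaxVacFull1ninshtrgravortjhhjPPNjffjf} reads simply $\vec G=\frac1m\vec F$ and its primed counterpart \er{MaxVacFull1ninshtrgravortPPN} reads $\vec G'=\frac1{m'}\vec F'$. Since by hypothesis \er{NoIn1gravortPPN11} and \er{NoIn2gravortPPN11} we have $\vec F'=A(t)\cdot\vec F$ and $m'=m$, it suffices to prove that $\vec G$ is a \emph{proper} vector field, i.e. $\vec G'=A(t)\cdot\vec G$; then $\vec G=\frac1m\vec F$ and $\vec G'=\frac1{m'}\vec F'$ are equivalent, which is exactly the assertion.

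First I would reduce $\vec G$ to a compact form. Differentiating the coordinate change \er{noninchgravortPPNnnn} shows that $\vec u$ transforms by \er{NoIn3gravortPPN11}, hence is speed-like; by hypothesis \er{NoIn5gravortPPN11} so is $\vec v$; therefore by Proposition \ref{yghgjtgyrtrt}(ii) the relative velocity along the trajectory $\vec q:=\vec u-\vec v$ is a proper vector field, $\vec q'=A(t)\cdot\vec q$. Using \er{apfrm7} to write $\frac12\nabla_{\vec x}(|\vec v|^2)=\{d_{\vec x}\vec v\}^T\cdot\vec v$, expressing $\partial_{t}\vec v=\frac{d}{dt}\vec v(\vec r(t),t)-d_{\vec x}\vec v\cdot\vec u$ as the material derivative along the trajectory, and collapsing $\vec u\times curl_{\vec x}\vec v+d_{\vec x}\vec v\cdot\vec u=\{d_{\vec x}\vec v\}^T\cdot\vec u$ by identity \er{apfrm9}, I expect $\vec G$ to reduce to
$$\vec G=\frac{d\vec q}{dt}+\left\{d_{\vec x}\vec v\right\}^T\cdot\vec q .$$

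Then I would check this collapsed expression transforms correctly. Since $\vec q$ is proper, $\frac{d\vec q'}{dt}=A(t)\cdot\frac{d\vec q}{dt}+\frac{dA}{dt}(t)\cdot\vec q$, producing an anomalous term $\frac{dA}{dt}\cdot\vec q$. For the Jacobian of the speed-like field I would invoke the transformation law already computed inside the proof of Proposition \ref{yghgjtgyrtrt}, namely \er{NoIn5redbstruiytrr}: $d_{\vec x'}\vec v'=A\cdot(d_{\vec x}\vec v)\cdot A^T+\frac{dA}{dt}\cdot A^T$. Transposing and using $\frac{dA}{dt}\cdot A^T=-A\cdot\{\frac{dA}{dt}\}^T$ gives $\{d_{\vec x'}\vec v'\}^T=A\cdot\{d_{\vec x}\vec v\}^T\cdot A^T-\frac{dA}{dt}\cdot A^T$. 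Substituting $\vec q'=A\cdot\vec q$ and using $A^T A=I$, the term $-\frac{dA}{dt}\cdot A^T\cdot A\cdot\vec q=-\frac{dA}{dt}\cdot\vec q$ cancels exactly the anomalous term from the time derivative, leaving $\vec G'=A\cdot\bigl(\frac{d\vec q}{dt}+\{d_{\vec x}\vec v\}^T\cdot\vec q\bigr)=A\cdot\vec G$, as required.

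The main obstacle is precisely this final cancellation. Neither $\frac{d}{dt}$ of a proper field nor the transpose Jacobian $\{d_{\vec x}\vec v\}^T$ is individually tensorial: each carries an antisymmetric defect of type $\frac{dA}{dt}\,A^T$ that encodes the rotation of the non-inertial frame, and the entire content of the proposition is that the two defects are equal and opposite. I would therefore keep the matrix-product order and the sign convention $\frac{dA}{dt}\cdot A^T+A\cdot\{\frac{dA}{dt}\}^T=0$ (from $A\cdot A^T=I$) rigidly consistent throughout, since this is the only place an error can creep in; everything else is the routine vector-identity bookkeeping of the reduction step.
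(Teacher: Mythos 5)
Your proof is correct, and it takes a genuinely different route from the paper's. The paper proves the proposition by brute-force substitution: it plugs the transformation laws \er{NoIn5gravortPPN11}--\er{NoIn3gravortPPN11} directly into the primed equation \er{MaxVacFull1ninshtrgravortghhghgPPN}, then works through the chain rule, the curl transformation from Proposition \ref{yghgjtgyrtrt}, identity \er{apfrm90}, and finally the antisymmetry of $A^{-1}(t)\cdot A'(t)$ to cancel the residual frame terms. You instead reduce the law of motion to the covariant first-order form $\frac{d\vec q}{dt}+\{d_{\vec x}\vec v\}^T\cdot\vec q=\frac{1}{m}\vec F$ for the relative velocity $\vec q=\vec u-\vec v$ (your reduction via \er{apfrm7}, \er{apfrm9} and the material derivative is exactly right), and then prove covariance by exhibiting the single cancellation between the defect $\frac{dA}{dt}\cdot\vec q$ in the time derivative of the proper field $\vec q$ and the defect $-\frac{dA}{dt}\cdot A^T$ in the transposed Jacobian law \er{NoIn5redbstruiytrr}. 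Interestingly, the reduced form you use is precisely the one the paper derives \emph{after} the proposition, in Remark \ref{hhghjggh} (equation \er{MaxVacFull1ninshtrgravorthjhjlhhjPPNhgghjjjjj}), but only for the force-free case and for a different purpose; the paper never exploits it for the invariance proof itself. Your route is shorter and makes the mechanism transparent — the whole proposition is the statement that the two non-tensorial defects are equal and opposite — at the modest cost of citing \er{NoIn5redbstruiytrr}, which is an intermediate equation inside the proof of Proposition \ref{yghgjtgyrtrt} rather than one of its stated conclusions (note that item {\bf(iii)} of that proposition only gives the transformation of the symmetrized Jacobian $d_{\vec x}\vec v+\{d_{\vec x}\vec v\}^T$, which does not suffice; so either cite the intermediate equation as you do, or re-derive it, which is one line of chain rule).
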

\begin{proof}
Using \er{apfrm9} we rewrite \er{MaxVacFull1ninshtrgravortPPN} as
\begin{equation}\label{MaxVacFull1ninshtrgravortghhghgPPN}
\frac{d\vec u'}{dt'}=-(\vec u'-\vec v')\times curl_{\vec x'}\vec
v'+\partial_{t'}\vec v'+d_{\vec x'}\vec v'\cdot\vec
v'+\frac{1}{m'}\vec F'.
\end{equation}
Next define the vector field $\vec v$ in the system $(*)$ in such a
way that it will be related to $\vec v'$ in the system $(**)$ due to
\er{NoIn5gravortPPN11}. I.e. $\vec v$ is given by $$\vec
v:=A^T(t)\cdot\left(\vec v'-\frac{dA}{dt}(t)\cdot\vec x-\frac{d\vec
z}{dt}(t)\right).$$ We are going to prove
\er{MaxVacFull1ninshtrgravortjhhjPPNjffjf} in the system $(*)$ using
the following relations between the physical characteristics in
coordinate systems $(*)$ and $(**)$:
\begin{align}
\label{NoIn1gravortPPN}\vec F'=A(t)\cdot\vec F,\\
\label{NoIn2gravortPPN}m'=m,\\
\label{NoIn3gravortPPN}\vec u'=A(t)\cdot \vec u+A'(t)\cdot\vec x+\vec w(t),\\
\label{NoIn5gravortPPN}\vec v'=A(t)\cdot \vec v+A'(t)\cdot\vec
x+\vec w(t),
\end{align}
where $\vec w(t):=\frac{d\vec z}{dt}(t)$ and
$A'(t)=\frac{dA}{dt}(t)$. Indeed, inserting these relations into
\er{MaxVacFull1ninshtrgravortghhghgPPN} we obtain:
\begin{multline}\label{MaxVacFull1ninshtrgravortghhghgjkggPPN}
\frac{d}{dt}\left(A(t)\cdot \vec u(t)+A'(t)\cdot\vec x(t)+\vec
w(t)\right)=-\left(A(t)\cdot(\vec u-\vec v)\right)\times curl_{\vec
x'}\left(A(t)\cdot \vec v+A'(t)\cdot\vec x+\vec
w(t)\right)\\+\partial_{t'}\left(A(t)\cdot \vec v+A'(t)\cdot\vec
x+\vec w(t)\right)+d_{\vec x'}\left(A(t)\cdot \vec v+A'(t)\cdot\vec
x+\vec w(t)\right)\cdot\left(A(t)\cdot \vec v+A'(t)\cdot\vec x+\vec
w(t)\right)\\+\frac{1}{m}A(t)\cdot\vec F.
\end{multline}
Next using the chain rule we deduce:
\begin{multline}
\label{vyguiuiujggghjjgggjhgghjortjggghkhjPPN}
\partial_{t'}\left(A(t)\cdot
\vec v+A'(t)\cdot\vec x+\vec w(t)\right)+d_{\vec x'}\left(A(t)\cdot
\vec v+A'(t)\cdot\vec x+\vec w(t)\right)\cdot\left(A'(t)\cdot\vec
x+\vec w(t)\right)=\\
\partial_{t}\left(A(t)\cdot \vec v+A'(t)\cdot\vec x+\vec
w(t)\right).
\end{multline}
Inserting it into \er{MaxVacFull1ninshtrgravortghhghgjkggPPN} we
deduce
\begin{multline}\label{MaxVacFull1ninshtrgravortghhghgjkgghklhPPN}
\frac{d}{dt}\left(A(t)\cdot \vec u(t)+A'(t)\cdot\vec x(t)+\vec
w(t)\right)=-\left(A(t)\cdot(\vec u-\vec v)\right)\times curl_{\vec
x'}\left(A(t)\cdot \vec v+A'(t)\cdot\vec x+\vec
w(t)\right)\\+\partial_{t}\left(A(t)\cdot \vec v+A'(t)\cdot\vec
x+\vec w(t)\right)+d_{\vec x}\left(A(t)\cdot \vec v+A'(t)\cdot\vec
x+\vec w(t)\right)\cdot \vec v+\frac{1}{m}A(t)\cdot\vec F.
\end{multline}
On the other hand, by \er{noninchgravortPPNnnn} and by Proposition
\ref{yghgjtgyrtrt}
we clearly have
\begin{equation}
\label{vyguiuiujggghjjgggorthjjhPPN} curl_{\vec x'}\left(
\left(A(t)\cdot\vec v+A'(t)\cdot\vec x+\vec
w(t)\right)\right)=A(t)\cdot curl_{\vec x}\left(\vec
v+A^{-1}(t)\cdot A'(t)\cdot\vec x+A^{-1}(t)\cdot \vec w(t)\right).
\end{equation}
Inserting it into \er{MaxVacFull1ninshtrgravortghhghgjkgghklhPPN} we
deduce:
\begin{multline}\label{MaxVacFull1ninshtrgravortghhghgjkgghklhjgkghglPPN}
\frac{d}{dt}\left(A(t)\cdot \vec u(t)+A'(t)\cdot\vec x(t)+\vec
w(t)\right)=\\-\left(A(t)\cdot(\vec u-\vec
v)\right)\times\left(A(t)\cdot curl_{\vec x}\left(\vec
v+A^{-1}(t)\cdot A'(t)\cdot\vec x+A^{-1}(t)\cdot
w(t)\right)\right)\\+\partial_{t}\left(A(t)\cdot \vec
v+A'(t)\cdot\vec x+\vec w(t)\right)+d_{\vec x}\left(A(t)\cdot \vec
v+A'(t)\cdot\vec x+\vec w(t)\right)\cdot \vec
v+\frac{1}{m}A(t)\cdot\vec F.
\end{multline}
Thus by \er{MaxVacFull1ninshtrgravortghhghgjkgghklhjgkghglPPN} and
\er{apfrm90} we have:
\begin{multline}\label{MaxVacFull1ninshtrgravortghhghgjkgghklhjgkghgPPN}
\frac{d}{dt}\left(A(t)\cdot \vec u(t)+A'(t)\cdot\vec x(t)+\vec
w(t)\right)=\\-A(t)\cdot \left((\vec u-\vec v)\times curl_{\vec
x}\left(\vec v+A^{-1}(t)\cdot A'(t)\cdot\vec x+A^{-1}(t)\cdot
w(t)\right)\right)\\+\partial_{t}\left(A(t)\cdot \vec
v+A'(t)\cdot\vec x+\vec w(t)\right)+d_{\vec x}\left(A(t)\cdot \vec
v+A'(t)\cdot\vec x+\vec w(t)\right)\cdot \vec
v+\frac{1}{m}A(t)\cdot\vec F.
\end{multline}
On the other hand clearly we have
$$\frac{d}{dt}\left(A(t)\cdot \vec u(t)+A'(t)\cdot\vec x(t)+\vec
w(t)\right)=A(t)\cdot\frac{d\vec u}{dt}+2A'(t)\cdot\vec
u+A''(t)\cdot \vec x(t)+\frac{d\vec w}{dt}(t).$$ Inserting it into
\er{MaxVacFull1ninshtrgravortghhghgjkgghklhjgkghgPPN} we deduce:
\begin{multline}\label{MaxVacFull1ninshtrgravortghhghgjkgghklhjgkghghjjkjPPN}
A(t)\cdot\frac{d\vec u}{dt}+2A'(t)\cdot\vec u+A''(t)\cdot \vec
x(t)+\frac{d\vec w}{dt}(t)=\\-A(t)\cdot \left((\vec u-\vec v)\times
curl_{\vec x}\left(\vec v+A^{-1}(t)\cdot A'(t)\cdot\vec
x+A^{-1}(t)\cdot w(t)\right)\right)\\+\partial_{t}\left(A(t)\cdot
\vec v+A'(t)\cdot\vec x+\vec w(t)\right)+d_{\vec x}\left(A(t)\cdot
\vec v+A'(t)\cdot\vec x+\vec
w(t)\right)\cdot \vec v+\frac{1}{m}A(t)\cdot\vec F\\
=-A(t)\cdot \left((\vec u-\vec v)\times curl_{\vec x}\vec
v\right)-A(t)\cdot \left((\vec u-\vec v)\times curl_{\vec
x}\left(A^{-1}(t)\cdot A'(t)\cdot\vec
x\right)\right)\\+A(t)\cdot\partial_t\vec v+2A'(t)\cdot \vec
v+A''(t)\cdot \vec x+\frac{d\vec w}{dt}(t)+A(t)\cdot d_{\vec x}\vec
v\cdot \vec v+\frac{1}{m}A(t)\cdot\vec F.
\end{multline}
We rewrite
\er{MaxVacFull1ninshtrgravortghhghgjkgghklhjgkghghjjkjPPN} as:
\begin{multline}\label{MaxVacFull1ninshtrgravortghhghgjkgghklhjgkghghjjkjhjkkggjkhjkPPN}
\frac{d\vec u}{dt}=- (\vec u-\vec v)\times curl_{\vec
x}\left(A^{-1}(t)\cdot A'(t)\cdot\vec x\right)-2A^{-1}(t)\cdot
A'(t)\cdot (\vec u-\vec v)\\-(\vec u-\vec v)\times curl_{\vec x}\vec
v +\partial_t\vec v+d_{\vec x}\vec v\cdot \vec v+\frac{1}{m}\vec F.
\end{multline}
Thus by \er{apfrm9} and
\er{MaxVacFull1ninshtrgravortghhghgjkgghklhjgkghghjjkjhjkkggjkhjkPPN}
we deduce:
\begin{multline}\label{MaxVacFull1ninshtrgravortghhghgjkgghklhjgkghghjjkjhjkkggjkhjkhjjhhPPN}
\frac{d\vec u}{dt}=d_{\vec x}\left(A^{-1}(t)\cdot A'(t)\cdot\vec
x\right)\cdot(\vec u-\vec v)-\left\{d_{\vec x}\left(A^{-1}(t)\cdot
A'(t)\cdot\vec x\right)\right\}^T\cdot(\vec u-\vec
v)-2A^{-1}(t)\cdot A'(t)\cdot (\vec u-\vec v)\\-(\vec u-\vec
v)\times curl_{\vec x}\vec v +\partial_t\vec v+d_{\vec x}\vec v\cdot
\vec v+\frac{1}{m}\vec F\\=\left(A^{-1}(t)\cdot
A'(t)\right)\cdot(\vec u-\vec v)-\left\{A^{-1}(t)\cdot
A'(t)\right\}^T\cdot(\vec u-\vec v)-2A^{-1}(t)\cdot A'(t)\cdot (\vec
u-\vec v)\\-(\vec u-\vec v)\times curl_{\vec x}\vec v
+\partial_t\vec v+d_{\vec x}\vec v\cdot \vec v+\frac{1}{m}\vec F.
\end{multline}
On the other hand the matrix $A^{-1}(t)\cdot A'(t)$ is antisymmetric
and thus
\begin{equation*}\left\{A^{-1}(t)\cdot
A'(t)\right\}^T=-\left(A^{-1}(t)\cdot A'(t)\right).
\end{equation*}
Inserting it into
\er{MaxVacFull1ninshtrgravortghhghgjkgghklhjgkghghjjkjhjkkggjkhjkhjjhhPPN}
we deduce:
\begin{equation}\label{MaxVacFull1ninshtrgravortghhghgjkgghklhjgkghghjjkjhjkkggjkhjkhjjhhfPPN}
\frac{d\vec u}{dt}=-(\vec u-\vec v)\times curl_{\vec x}\vec v
+\partial_t\vec v+d_{\vec x}\vec v\cdot \vec v+\frac{1}{m}\vec F.
\end{equation}
Thus again by \er{apfrm9} we finally rewrite
\er{MaxVacFull1ninshtrgravortghhghgjkgghklhjgkghghjjkjhjkkggjkhjkhjjhhfPPN}
as:
\begin{equation}\label{MaxVacFull1ninshtrgravortjhhjPPN}
\frac{d\vec u}{dt}=-\vec u\times curl_{\vec x}\vec
v+\partial_{t}\vec v+\nabla_{\vec x}\left(\frac{1}{2}|\vec
v|^2\right)+\frac{1}{m}\vec F.
\end{equation}
Therefore in the coordinate system $(*)$ we observe a validity of
Second Law of Newton in the same form as
\er{MaxVacFull1ninshtrgravortPPN}.
\end{proof}
\begin{remark}\label{hhghjggh}
Assume that in some inertial or non-inertial cartesian coordinate
system some particle with the place $\vec r(t)$ and velocity $\vec
u(t)=\frac{d\vec r}{dt}(t)$ moves in the gravitational field, and
all other forces, acting on the particle, except of the
gravitational forces are negligible. Then since, as before, by
\er{MaxVacFull1ninshtrgravorthjhjlhhjPPN} with $\vec F=0$ we have
\begin{multline}\label{MaxVacFull1ninshtrgravorthjhjlhhjPPNhggh}
\frac{d\vec u}{dt}(t)=-\left(\vec u(t)-\vec v\left(\vec
r(t),t\right)\right)\times curl_{\vec x}\vec v\left(\vec
r(t),t\right)+\partial_{t}\vec v\left(\vec r(t),t\right)+d_{\vec
x}\vec v\left(\vec r(t),t\right)\cdot\vec v\left(\vec
r(t),t\right)=\\ \partial_{t}\vec v\left(\vec r(t),t\right)+d_{\vec
x}\vec v\left(\vec r(t),t\right)\cdot\frac{d\vec r}{dt}(t) -d_{\vec
x}\vec v\left(\vec r(t),t\right)\cdot\left(\vec u(t)-\vec
v\left(\vec r(t),t\right)\right)-\left(\vec u(t)-\vec v\left(\vec
r(t),t\right)\right)\times curl_{\vec x}\vec v\left(\vec
r(t),t\right)\\= \frac{d}{dt}\left\{\vec v\left(\vec
r(t),t\right)\right\}-\left\{d_{\vec x}\vec v\left(\vec
r(t),t\right)\right\}^T\cdot\left(\vec u(t)-\vec v\left(\vec
r(t),t\right)\right),
\end{multline}
we deduce that the vectorial quantity $\left(\frac{d\vec
r}{dt}(t)-\vec v\left(\vec r(t),t\right)\right)=\left(\vec u(t)-\vec
v\left(\vec r(t),t\right)\right)$ satisfies the following first
order homogenous vectorial linear ordinary differential equation:
\begin{equation}\label{MaxVacFull1ninshtrgravorthjhjlhhjPPNhgghjjjjj}
\frac{d}{dt}\left\{u(t)-\vec v\left(\vec r(t),t\right)\right\}+
\left\{d_{\vec x}\vec v\left(\vec
r(t),t\right)\right\}^T\cdot\left\{\vec u(t)-\vec v\left(\vec
r(t),t\right)\right\}=0.
\end{equation}
In particular if for some instant of time $t_0$ we have
\begin{equation}\label{hjfgjffgffg}
\vec u(t_0):=\frac{d\vec r}{dt}(t_0)=\vec v\left(\vec
r(t_0),t_0\right)
\end{equation}
then by uniqueness theorem for ordinary differential equations,
\er{MaxVacFull1ninshtrgravorthjhjlhhjPPNhgghjjjjj} and
\er{hjfgjffgffg} together imply
\begin{equation}\label{hjfgjffgffgugh}
\vec u(t):=\frac{d\vec r}{dt}(t)=\vec v\left(\vec
r(t),t\right)\quad\quad\forall t,
\end{equation}
for every instant of time. I.e. if the velocity of the particle for
some initial instant of time coincides with the local vectorial
gravitational potential, then it will coincide with it at any
instant of time and the trajectory of motion will be tangent to the
direction of the local vectorial gravitational potential.
\end{remark}
\begin{remark}\label{jbkvjvjjccc}
Assume that some particle with the place $\vec r:=\vec r(t)$, the
velocity $\vec u:=\vec u(t)=\frac{d\vec r}{dt}(t)$ and the inertial
mass $m$ moves in the outer gravitational field with the vectorial
gravitational potential $\vec v:=\vec v(\vec x,t)$ in the absence of
non-gravitational forces. Then we can associate a Lagrangian with
\er{MaxVacFull1ninshtrgravorthjhjlhhjPPN}. Indeed, for this case we
define a Lagrangian:
\begin{equation}\label{vhfffngghkjgghfjjSYSPN11}
\mathcal{L}_0\left(\frac{d\vec r}{dt},\vec
r,t\right):=\frac{m}{2}\left|\frac{d\vec r}{dt}-\vec v(\vec
r,t)\right|^2.
\end{equation}
This Lagrangian is invariant under the change of non-inertial
cartesian coordinate systems, given by
\er{noninchgravortbstrjgghguittu2}. Moreover, we can easily deduce
(see subsection \ref{hggyugyuy} for more details) that a trajectory
$\vec r(t):[0,T]\to\mathbb{R}^3$ is a critical point of the
functional
\begin{equation}\label{btfffygtgyggyijhhkkSYSPN11}
I_0=\int_0^T \mathcal{L}_0\left(\frac{d\vec r}{dt}(t),\vec
r(t),t\right)dt
\end{equation}
if and only if it satisfies
\begin{equation}\label{vhfffngghkjgghggtghjgfhjoyuiyuyhiyyukukyihyuSYSPN11}
-m\frac{d^2\vec r}{dt^2}+m\left(\frac{\partial}{\partial t}\vec
v(\vec r,t)+\nabla_{\vec x}\left(\frac{1}{2}\left|\vec v(\vec
r,t)\right|^2\right)-\frac{d\vec r}{dt}\times curl_{\vec x}\vec
v\left(\vec r,t\right)\right)=0,
\end{equation}
consistently with \er{MaxVacFull1ninshtrgravorthjhjlhhjPPN} for the
case $\vec F=0$.
\end{remark}

Next, in order to fit the Second Law of Newton in the form
\er{MaxVacFull1ninshtrgravorthjhjlhhjPPN} with the classical Second
Law of Newton and the Newtonian Law of Gravity we consider that in
\underline{inertial} coordinate system $(*)$, at least in the first
approximation, we should have
\begin{equation}
\label{MaxVacFull1ninshtrgravortghhghgjkgghklhjgkghghjjkjhjkkggjkhjkhjjhhfhjhklkhkhjjklzzzyyyhjggjhgghhjhNWNWBWHWPPN222kgghjghjghj}
\begin{cases}
curl_{\vec x}\vec v= 0,\\
\frac{\partial\vec v}{\partial t}+\frac{1}{2}\nabla_{\vec
x}\left(|\vec v|^2\right)= -\nabla_{\vec x}\Phi,
\end{cases}
\end{equation}
where $\Phi$ is a scalar Newtonian gravitational potential which is
assumed to be a proper scalar, which satisfies
\begin{equation}
\label{MaxVacFull1ninshtrgravortghhghgjkgghklhjgkghghjjkjhjkkggjkhjkhjjhhfhjhklkhkhjjklzzzyyyhjggjhgghhjhNWNWNWBWHWPPN222}
\begin{cases}
\Delta_{\vec x}\Phi=4\pi GM\quad\quad\quad\forall \,(\vec x,t),\\
\lim\limits_{|\vec x|\to+\infty}\nabla_{\vec x}\Phi(\vec
x,t)=0\quad\quad\quad\forall t,
\end{cases}
\end{equation}
where $M$ is the gravitational mass density and $G$ is the
gravitational constant. Thus, since we require $curl_{\vec x}\vec v=
0$,
\er{MaxVacFull1ninshtrgravortghhghgjkgghklhjgkghghjjkjhjkkggjkhjkhjjhhfhjhklkhkhjjklzzzyyyhjggjhgghhjhNWNWBWHWPPN222kgghjghjghj}
is equivalent to:
\begin{equation}
\label{MaxVacFull1ninshtrgravortghhghgjkgghklhjgkghghjjkjhjkkggjkhjkhjjhhfhjhklkhkhjjklzzzyyyhjggjhgghhjhNWNWBWHWPPN222}
\begin{cases}
curl_{\vec x}\vec v= 0,\\
\frac{\partial\vec v}{\partial t}+d_\vec x\vec v\cdot\vec v=
-\nabla_{\vec x}\Phi,
\end{cases}
\end{equation}
Clearly the law
\er{MaxVacFull1ninshtrgravortghhghgjkgghklhjgkghghjjkjhjkkggjkhjkhjjhhfhjhklkhkhjjklzzzyyyhjggjhgghhjhNWNWBWHWPPN222}
is invariant under the change of inertial coordinate system given by
\er{noninchgravortbstrjgghguittu1}. Note also that, since in the
system $(*)$ we have $curl_{\vec x}\vec v=0$, we can write
\er{MaxVacFull1ninshtrgravortghhghgjkgghklhjgkghghjjkjhjkkggjkhjkhjjhhfhjhklkhkhjjklzzzyyyhjggjhgghhjhNWNWBWHWPPN222kgghjghjghj}
as the following Hamilton-Jacobi type equation:
\begin{equation}
\label{MaxVacFull1ninshtrgravortghhghgjkgghklhjgkghghjjkjhjkkggjkhjkhjjhhfhjhklkhkhjjklzzzyyyhjggjhgghhjhNWNWNWNWNWBWHWPPN222}
\begin{cases}
\vec v=\nabla_{\vec x}Z,\\
\frac{\partial Z}{\partial t}+\frac{1}{2}\left|\nabla_{\vec
x}Z\right|^2=-\Phi,
\end{cases}
\end{equation}
where $Z:=Z(\vec x,t)$ is some scalar field. We would like to derive
the law which is invariant in every non-inertial cartesian
coordinate system and is equivalent to
\er{MaxVacFull1ninshtrgravortghhghgjkgghklhjgkghghjjkjhjkkggjkhjkhjjhhfhjhklkhkhjjklzzzyyyhjggjhgghhjhNWNWBWHWPPN222}
in every inertial coordinate system. Note that
\er{MaxVacFull1ninshtrgravortghhghgjkgghklhjgkghghjjkjhjkkggjkhjkhjjhhfhjhklkhkhjjklzzzyyyhjggjhgghhjhNWNWBWHWPPN222}
and
\er{MaxVacFull1ninshtrgravortghhghgjkgghklhjgkghghjjkjhjkkggjkhjkhjjhhfhjhklkhkhjjklzzzyyyhjggjhgghhjhNWNWNWBWHWPPN222}
implies:
\begin{equation}
\label{MaxVacFull1ninshtrgravortghhghgjkgghklhjgkghghjjkjhjkkggjkhjkhjjhhfhjhklkhkhjjklzzzyyyhjfgfkjgjNWBWHWPPN222}
\begin{cases}
curl_{\vec x}\left(curl_{\vec x}\vec v\right)= 0,\\
div_{\vec x}\left\{\frac{\partial\vec v}{\partial t}+d_\vec x\vec
v\cdot\vec v+\frac{1}{2}\vec v\times curl_{\vec x}\vec v\right\}
= -4\pi GM,
\end{cases}
\end{equation}
that we rewrite using \er{apfrm3} as:
\begin{equation}
\label{MaxVacFull1ninshtrgravortghhghgjkgghklhjgkghghjjkjhjkkggjkhjkhjjhhfhjhklkhkhjjklzzzyyyNWBWHWPPN222}
\begin{cases}
curl_{\vec x}\left(curl_{\vec x}\vec v\right)= 0,\\
\frac{\partial}{\partial t}\left(div_{\vec x}\vec v\right)+\vec
v\cdot\nabla_{\vec x}\left(div_{\vec x}\vec
v\right)+\frac{1}{4}\left|d_{\vec x}\vec v+\{d_{\vec x}\vec
v\}^T\right|^2= -4\pi GM,
\end{cases}
\end{equation}
or, equivalently, as:
\begin{equation}
\label{MaxVacFull1ninshtrgravortghhghgjkgghklhjgkghghjjkjhjkkggjkhjkhjjhhfhjhklkhkhjjklzzzyyyNWBWHWPPN222ll}
\begin{cases}
curl_{\vec x}\left(curl_{\vec x}\vec v\right)= 0,\\
\frac{\partial}{\partial t}\left(div_{\vec x}\vec v\right)+div_{\vec
x}\left\{\left(div_{\vec x}\vec v\right)\vec
v\right\}+\frac{1}{4}\left|d_{\vec x}\vec v+\{d_{\vec x}\vec
v\}^T\right|^2-\left(div_{\vec x}\vec v\right)^2= -4\pi GM.
\end{cases}
\end{equation}
Next observe that using Proposition \ref{yghgjtgyrtrt}
we deduce that the laws in
\er{MaxVacFull1ninshtrgravortghhghgjkgghklhjgkghghjjkjhjkkggjkhjkhjjhhfhjhklkhkhjjklzzzyyyNWBWHWPPN222}
and
\er{MaxVacFull1ninshtrgravortghhghgjkgghklhjgkghghjjkjhjkkggjkhjkhjjhhfhjhklkhkhjjklzzzyyyNWBWHWPPN222ll}
are invariant under the change of non-inertial cartesian coordinate
system, given by \er{noninchgravortbstrjgghguittu2}. So, instead of
condition
\er{MaxVacFull1ninshtrgravortghhghgjkgghklhjgkghghjjkjhjkkggjkhjkhjjhhfhjhklkhkhjjklzzzyyyhjggjhgghhjhNWNWBWHWPPN222},
we can consider the frame independent condition
\er{MaxVacFull1ninshtrgravortghhghgjkgghklhjgkghghjjkjhjkkggjkhjkhjjhhfhjhklkhkhjjklzzzyyyNWBWHWPPN222ll}
together with the frame independent requirement, that the vectorial
gravitational potential $\vec v$ is an \underline{asymptotically
acceptable} speed-like vector field (see Definition
\ref{jhjkhhjhjhgjg;kkljkl} and Proposition
\ref{jhjkhhjhjhgjgjjkjkjklljj}). Then we can prove the following:
\begin{proposition}\label{jghjgyfyffffhfgljjkjl}
Let $\vec v:=\vec v(\vec x,t)$ be the vectorial gravitational
potential which is assumed to be an asymptotically acceptable
speed-like vector field (see Definition \ref{jhjkhhjhjhgjg;kkljkl}
and Proposition \ref{jhjkhhjhjhgjgjjkjkjklljj}) and let
$\Phi:=\Phi(\vec x,t)$ be a scalar Newtonian gravitational potential
which is assumed to be a proper scalar, which satisfies
\begin{equation}
\label{MaxVacFull1ninshtrgravortghhghgjkgghklhjgkghghjjkjhjkkggjkhjkhjjhhfhjhklkhkhjjklzzzyyyhjggjhgghhjhNWNWNWBWHWPPN222jkkhh}
\lim\limits_{|\vec x|\to+\infty}\nabla_{\vec x}\Phi(\vec
x,t)=0\quad\quad\quad\forall t.
\end{equation}
Furthermore, let $(*)$ be some cartesian coordinate system. Then:
\begin{itemize}
\item
If the coordinate system $(*)$ is \underline{non-rotating} with
respect to the speed-like vector field $\vec v$ (see Definition
\ref{jhjkhhjhjhgjg;kkljkljkkjlhlhhlh}) then in the system $(*)$ we
have
\begin{equation}\label{hjkggjjkhjh}
curl_{\vec x}\vec v= 0\quad\quad\forall (\vec x,t),
\end{equation}
if and only if
\begin{equation}\label{hjkggjkjtujhkdrysdutg8yfgjhi}
curl_{\vec x}\left(curl_{\vec x}\vec v\right)= 0\quad\quad\forall
(\vec x,t).
\end{equation}
\item
If, in addition, the coordinate system $(*)$ is \underline{inertial}
with respect to the speed-like vector field $\vec v$ (see Definition
\ref{jhjkhhjhjhgjg;kkljkljkkjlhlhhlh}) then in the system $(*)$ we
have
\begin{equation}\label{hjkggjjkhjh22}
\begin{cases}
curl_{\vec x}\vec v= 0\quad\quad\forall (\vec x,t),\\
\frac{\partial\vec v}{\partial t}+d_\vec x\vec v\cdot\vec v=
-\nabla_{\vec x}\Phi\quad\quad\forall (\vec x,t),
\end{cases}
\end{equation}
if and only if
\begin{equation}\label{hjkggjkjtujhkdrysdutg8yfgjhi22}
\begin{cases}
curl_{\vec x}\left(curl_{\vec x}\vec v\right)= 0\quad\quad\forall (\vec x,t),\\
\frac{\partial}{\partial t}\left(div_{\vec x}\vec v\right)+div_{\vec
x}\left\{\left(div_{\vec x}\vec v\right)\vec
v\right\}+\frac{1}{4}\left|d_{\vec x}\vec v+\{d_{\vec x}\vec
v\}^T\right|^2-\left(div_{\vec x}\vec v\right)^2= -\Delta_{\vec
x}\Phi\quad\quad\forall (\vec x,t).
\end{cases}
\end{equation}
\end{itemize}
\end{proposition}
\begin{proof}
Obviously, \er{hjkggjjkhjh} implies
\er{hjkggjkjtujhkdrysdutg8yfgjhi}. On the other hand, in the case
that the coordinate system $(*)$ is non-rotating with respect to the
speed-like vector field $\vec v$, by Definition
\ref{jhjkhhjhjhgjg;kkljkljkkjlhlhhlh} in the system $(*)$ we have
\begin{equation} \label{jljjjjkhhhhkhhjhjkjkjkljojkjkjkhj1jjkjkhkgh11vv}
\lim\limits_{|\vec x|\to+\infty}\left(d_{\vec x}\vec v(\vec
x,t)\right)
=0\quad\quad\quad\forall t.
\end{equation}
Therefore, if \er{hjkggjkjtujhkdrysdutg8yfgjhi} holds, then by
\er{hjkggjkjtujhkdrysdutg8yfgjhi} and
\er{jljjjjkhhhhkhhjhjkjkjkljojkjkjkhj1jjkjkhkgh11vv} in the
coordinate system $(*)$ we deduce
\begin{equation}
\label{MaxVacFull1ninshtrgravortghhghgjkgghklhjgkghghjjkjhjkkggjkhjkhjjhhfhjhklkhkhjjklzzzyyyNWBWHWPPN222lljkhjhjhhjuighg}
\begin{cases}
\Delta_{\vec x}\left(curl_{\vec x}\vec v(\vec
x,t)\right)= 0\quad\quad\quad\forall\, (\vec x,t),\\
\lim\limits_{|\vec x|\to+\infty}\left(curl_{\vec x}\vec v(\vec
x,t)\right)=0\quad\quad\quad\forall t.
\end{cases}
\end{equation}
However, by Liouville's theorem for the Laplace equation,
\er{MaxVacFull1ninshtrgravortghhghgjkgghklhjgkghghjjkjhjkkggjkhjkhjjhhfhjhklkhkhjjklzzzyyyNWBWHWPPN222lljkhjhjhhjuighg}
implies that in the coordinate system $(*)$ we also have
\begin{equation}
\label{MaxVacFull1ninshtrgravortghhghgjkgghklhjgkghghjjkjhjkkggjkhjkhjjhhfhjhklkhkhjjklzzzyyyNWBWHWPPN222lljkhjhjhhjuighgloojjkjl}
curl_{\vec x}\vec v(\vec x,t)=0\quad\quad\quad\forall\, (\vec x,t).
\end{equation}
So for that case \er{hjkggjjkhjh} and
\er{hjkggjkjtujhkdrysdutg8yfgjhi} are indeed equivalent.

Next, if, in addition, the coordinate system $(*)$ is inertial with
respect to the speed-like vector field $\vec v$, then by Definition
\ref{jhjkhhjhjhgjg;kkljkljkkjlhlhhlh} in the system $(*)$ we have
\begin{align} \label{jljjjjkhhhhkhhjhjkjkjkljojkjkjkhj1jjkjkhkgh11}
\lim\limits_{|\vec x|\to+\infty}\left(d_{\vec x}\vec v(\vec
x,t)\right)
=0\quad\quad\quad\forall t\,,\\
\label{jljjjjkhhhhkhhjhjkjkjkljojkjkjkhj211} \lim\limits_{|\vec
x|\to+\infty}\vec v(\vec x,t)=\vec d\quad\quad\quad\forall t\,,
\end{align}
where $\vec d\in \mathbb{R}^3$ is a constant (independent of $t$)
vector. In particular, by
\er{MaxVacFull1ninshtrgravortghhghgjkgghklhjgkghghjjkjhjkkggjkhjkhjjhhfhjhklkhkhjjklzzzyyyhjggjhgghhjhNWNWNWBWHWPPN222jkkhh},
\er{jljjjjkhhhhkhhjhjkjkjkljojkjkjkhj1jjkjkhkgh11} and
\er{jljjjjkhhhhkhhjhjkjkjkljojkjkjkhj211} in system $(*)$ we also
have
\begin{equation}
\label{MaxVacFull1ninshtrgravortghhghgjkgghklhjgkghghjjkjhjkkggjkhjkhjjhhfhjhklkhkhjjklzzzyyyhjfgfkjgjNWBWHWPPN222jjhhkljjkkhgjjhfhkjhjgjgj}
\lim\limits_{|\vec x|\to+\infty}\left(\frac{\partial\vec v}{\partial
t}+d_{\vec x}\vec v\cdot\vec v+\nabla_{\vec
x}\Phi\right)=0\quad\quad\quad\forall t.
\end{equation}
On the other hand, using \er{apfrm3} and \er{apfrm9} we deduce
\begin{multline}
\label{MaxVacFull1ninshtrgravortghhghgjkgghklhjgkghghjjkjhjkkggjkhjkhjjhhfhjhklkhkhjjklzzzyyyhjfgfkjgjNWBWHWPPN222jkjkkhhk}
\left(\frac{\partial}{\partial t}\left(div_{\vec x}\vec
v\right)+div_{\vec x}\left\{\left(div_{\vec x}\vec v\right)\vec
v\right\}+\frac{1}{4}\left|d_{\vec x}\vec v+\{d_{\vec x}\vec
v\}^T\right|^2-\left(div_{\vec x}\vec v\right)^2\right)-\frac{1}{2}\vec v\cdot curl_{\vec x}\left(curl_{\vec x}\vec v\right)=\\
\frac{\partial}{\partial t}\left(div_{\vec x}\vec v\right)+\vec
v\cdot\nabla_{\vec x}\left(div_{\vec x}\vec
v\right)+\frac{1}{4}\left|d_{\vec x}\vec v+\{d_{\vec x}\vec
v\}^T\right|^2-\frac{1}{2}\vec v\cdot curl_{\vec x}\left(curl_{\vec
x}\vec
v\right)=\\
div_{\vec x}\left\{\frac{\partial\vec v}{\partial t}+d_\vec x\vec
v\cdot\vec v\right\}+\frac{1}{4}\left|d_{\vec x}\vec v-\{d_{\vec
x}\vec v\}^T\right|^2-\frac{1}{2}\vec v\cdot curl_{\vec
x}\left(curl_{\vec x}\vec v\right)
=\\
div_{\vec x}\left\{\frac{\partial\vec v}{\partial t}+d_\vec x\vec
v\cdot\vec v\right\}+\frac{1}{2}\left|curl_{\vec x}\vec
v\right|^2-\frac{1}{2}\vec v\cdot curl_{\vec x}\left(curl_{\vec
x}\vec v\right)=
\\ div_{\vec x}\left\{\frac{\partial\vec v}{\partial t}+d_\vec x\vec
v\cdot\vec v+\frac{1}{2}\vec v\times curl_{\vec x}\vec v\right\} =
div_{\vec x}\left\{\frac{\partial\vec v}{\partial t}+\nabla_\vec
x\left(\frac{1}{2}\left|\vec v\right|^2\right)-\frac{1}{2}\vec
v\times curl_{\vec x}\vec v\right\}\,.
\end{multline}
In particular, by
\er{MaxVacFull1ninshtrgravortghhghgjkgghklhjgkghghjjkjhjkkggjkhjkhjjhhfhjhklkhkhjjklzzzyyyhjfgfkjgjNWBWHWPPN222jkjkkhhk}
we deduce that \er{hjkggjjkhjh22} implies
\er{hjkggjkjtujhkdrysdutg8yfgjhi22}. On the other hand if
\er{hjkggjkjtujhkdrysdutg8yfgjhi22} satisfied, then by
\er{MaxVacFull1ninshtrgravortghhghgjkgghklhjgkghghjjkjhjkkggjkhjkhjjhhfhjhklkhkhjjklzzzyyyhjfgfkjgjNWBWHWPPN222jkjkkhhk}
and
\er{MaxVacFull1ninshtrgravortghhghgjkgghklhjgkghghjjkjhjkkggjkhjkhjjhhfhjhklkhkhjjklzzzyyyNWBWHWPPN222lljkhjhjhhjuighgloojjkjl}
we deduce
\begin{equation}
\label{MaxVacFull1ninshtrgravortghhghgjkgghklhjgkghghjjkjhjkkggjkhjkhjjhhfhjhklkhkhjjklzzzyyyhjfgfkjgjNWBWHWPPN222jjhhklff}
\begin{cases}
curl_{\vec x}\vec v= 0,\\
div_{\vec x}\left\{\frac{\partial\vec v}{\partial t}+\nabla_\vec
x\left(\frac{1}{2}\left|\vec v\right|^2\right)\right\}
=-\Delta_{\vec x}\Phi.
\end{cases}
\end{equation}
In particular, by
\er{MaxVacFull1ninshtrgravortghhghgjkgghklhjgkghghjjkjhjkkggjkhjkhjjhhfhjhklkhkhjjklzzzyyyhjfgfkjgjNWBWHWPPN222jjhhklff}
together with \er{apfrm12} we deduce
\begin{multline}
\label{MaxVacFull1ninshtrgravortghhghgjkgghklhjgkghghjjkjhjkkggjkhjkhjjhhfhjhklkhkhjjklzzzyyyhjfgfkjgjNWBWHWPPN222jjhhkl}
-\Delta_{\vec x}\left(\nabla_{\vec x}\Phi\right)=\nabla_{\vec
x}\left(div_{\vec x}\left\{\frac{\partial\vec v}{\partial
t}+\nabla_\vec x\left(\frac{1}{2}\left|\vec
v\right|^2\right)\right\}\right)=\nabla_{\vec x}\left(div_{\vec
x}\left\{\frac{\partial\vec v}{\partial
t}\right\}\right)+\Delta_\vec x\left(\nabla_{\vec
x}\left\{\frac{1}{2}\left|\vec v\right|^2\right\}\right)\\=
\Delta_{\vec x}\left(\frac{\partial\vec v}{\partial t}+\nabla_{\vec
x}\left(\frac{1}{2}\left|\vec v\right|^2\right)\right)=\Delta_{\vec
x}\left(\frac{\partial\vec v}{\partial t}+d_{\vec x}\vec v\cdot\vec
v\right).
\end{multline}
Therefore, by
\er{MaxVacFull1ninshtrgravortghhghgjkgghklhjgkghghjjkjhjkkggjkhjkhjjhhfhjhklkhkhjjklzzzyyyhjfgfkjgjNWBWHWPPN222jjhhkl}
together with asymptotic condition
\er{MaxVacFull1ninshtrgravortghhghgjkgghklhjgkghghjjkjhjkkggjkhjkhjjhhfhjhklkhkhjjklzzzyyyhjfgfkjgjNWBWHWPPN222jjhhkljjkkhgjjhfhkjhjgjgj}
we deduce
\begin{equation}
\label{MaxVacFull1ninshtrgravortghhghgjkgghklhjgkghghjjkjhjkkggjkhjkhjjhhfhjhklkhkhjjklzzzyyyhjfgfkjgjNWBWHWPPN222jjhhkljjkkhgjjhf}
\begin{cases}
\Delta_{\vec x}\left(\frac{\partial\vec v}{\partial t}+d_{\vec
x}\vec v\cdot\vec v+\nabla_{\vec x}\Phi\right)=0.
\quad\quad\quad\forall\, (\vec x,t),\\
\lim\limits_{|\vec x|\to+\infty}\left(\frac{\partial\vec v}{\partial
t}+d_{\vec x}\vec v\cdot\vec v+\nabla_{\vec
x}\Phi\right)=0\quad\quad\quad\forall t.
\end{cases}
\end{equation}
Thus, again by Liouville's theorem for the Laplace equation,
\er{MaxVacFull1ninshtrgravortghhghgjkgghklhjgkghghjjkjhjkkggjkhjkhjjhhfhjhklkhkhjjklzzzyyyhjfgfkjgjNWBWHWPPN222jjhhkljjkkhgjjhf}
implies that in the coordinate system $(*)$ we have
\begin{equation}
\label{MaxVacFull1ninshtrgravortghhghgjkgghklhjgkghghjjkjhjkkggjkhjkhjjhhfhjhklkhkhjjklzzzyyyhjggjhgghhjhNWBWHWPPN222}
\frac{\partial\vec v}{\partial t}+d_{\vec x}\vec v\cdot\vec
v+\nabla_{\vec x}\Phi=0\quad\quad\quad\forall\, (\vec x,t).
\end{equation}
%
%
%
%
%
%
%
%
Then in the system $(*)$
\er{MaxVacFull1ninshtrgravortghhghgjkgghklhjgkghghjjkjhjkkggjkhjkhjjhhfhjhklkhkhjjklzzzyyyhjggjhgghhjhNWBWHWPPN222}
and
\er{MaxVacFull1ninshtrgravortghhghgjkgghklhjgkghghjjkjhjkkggjkhjkhjjhhfhjhklkhkhjjklzzzyyyNWBWHWPPN222lljkhjhjhhjuighgloojjkjl}
indeed implies \er{hjkggjjkhjh22} and the result follows.
\end{proof}
\begin{definition}\label{jggggggggggggggggggggjklj}
We can call a cartesian coordinate system, which is inertial with
respect to the vectorial gravitational potential $\vec v$, by the
usual name an \underline{inertial} cartesian coordinate system.
Then, by Proposition \ref{jghfgjdhdhg} it is clear, that every
cartesian coordinate system $(**)$, that we can get from such an
inertial coordinate system $(*)$ by the usual Galilean
Transformations, and all equivalent coordinate systems also will be
inertial. Note also that by Proposition \ref{jghjgyfyffffhfgljjkjl},
in the case of the simplest Newtonian-like model of the gravity,
given by
\er{MaxVacFull1ninshtrgravortghhghgjkgghklhjgkghghjjkjhjkkggjkhjkhjjhhfhjhklkhkhjjklzzzyyyNWBWHWPPN222ll},
equalities in
\er{MaxVacFull1ninshtrgravortghhghgjkgghklhjgkghghjjkjhjkkggjkhjkhjjhhfhjhklkhkhjjklzzzyyyhjggjhgghhjhNWNWBWHWPPN222}
are valid in every inertial cartesian system. More generally, we
call every cartesian coordinate system, which is non-rotating with
respect to the vectorial gravitational potential $\vec v$, by the
usual name a \underline{non-rotating} cartesian coordinate system.
Note again that by Proposition \ref{jghjgyfyffffhfgljjkjl}, in the
case of the gravity given by
\er{MaxVacFull1ninshtrgravortghhghgjkgghklhjgkghghjjkjhjkkggjkhjkhjjhhfhjhklkhkhjjklzzzyyyNWBWHWPPN222ll},
equality \er{hjkggjjkhjh} holds in every non-rotating cartesian
system.
\end{definition}

As a consequence of all mentioned above, the second law of Newton,
invariant under the change of non-inertial cartesian coordinate
system, has the following form:
\begin{equation}\label{noninchgravortbstrjgghguittu2gjgghhjhghjhjgghgghghghtytythvfghfgghjggghjgjh}
m\frac{d^2\vec x}{dt^2}=m\frac{d\vec
u}{dt}=m\left(\frac{\partial\vec v}{\partial t}(\vec
x,t)+\frac{1}{2}\nabla_{\vec x}\left(\left|\vec
v\right|^2\right)(\vec x,t)\right)-m\vec u\times curl_{\vec x}\vec
v(\vec x,t)+\vec F,
\end{equation}
and the first approximation of the law of gravity, which is
equivalent to the Newtonian gravity and which is invariant under the
change of non-inertial cartesian coordinate system, has the
following form:
\begin{equation}
\label{MaxVacFull1ninshtrgravortghhghgjkgghklhjgkghghjjkjhjkkggjkhjkhjjhhfhjhklkhkhjjklzzzyyyNWBWHWPPN222jkgghgg}
\begin{cases}
curl_{\vec x}\left(curl_{\vec x}\vec v\right)= 0,\\
\frac{\partial}{\partial t}\left(div_{\vec x}\vec v\right)+div_{\vec
x}\left\{\left(div_{\vec x}\vec v\right)\vec
v\right\}+\frac{1}{4}\left|d_{\vec x}\vec v+\{d_{\vec x}\vec
v\}^T\right|^2-\left(div_{\vec x}\vec v\right)^2= -4\pi GM.
\end{cases}
\end{equation}
%
%
%
%
%
%
Here $\vec x:=\vec x(t)$, $\vec u:=\vec u(t)=\frac{d\vec x}{dt}(t)$
and $m$ are the place, the velocity and the inertial mass of some
given particle at the moment of time $t$, $\vec v:=\vec v(\vec x,t)$
is the vectorial gravitational potential, $M$ is the volume density
of gravitational masses and $\vec F$ is the total non-gravitational
force, acting on the given particle. Moreover, the vectorial
gravitational potential $\vec v$ is a speed-like vector field, i.e.
under the changes of inertial or non-inertial cartesian coordinate
system it behaves like a field of velocities of some continuum. Thus
we could introduce the fictitious continuum medium covering all the
space, that we can call Aether, such that $\vec v(\vec x,t)$ is a
fictitious velocity of this medium in the point $\vec x$ at the time
$t$.

\begin{remark}\label{ugyugg} In the case of Newtonian-like gravity, given by
\er{MaxVacFull1ninshtrgravortghhghgjkgghklhjgkghghjjkjhjkkggjkhjkhjjhhfhjhklkhkhjjklzzzyyyNWBWHWPPN222ll},
the quantity $Z(\vec x,t)$ in
\er{MaxVacFull1ninshtrgravortghhghgjkgghklhjgkghghjjkjhjkkggjkhjkhjjhhfhjhklkhkhjjklzzzyyyhjggjhgghhjhNWNWNWNWNWBWHWPPN222}
is well defined (modulo additive constant) in every non-rotating
cartesian coordinate system and in particular it is well defined
(modulo additive constant) in inertial coordinate systems. It can be
easily checked by straightforward calculations that, if under the
change of coordinate system $\bf{(*)}$ to $\bf{(**)}$ given by the
Galilean Transformation
\begin{equation}\label{noninchgravortbstrjgghguittu1intmm}
\begin{cases}
\vec x'=\vec x+\vec wt,\\
t'=t,
\end{cases}
\end{equation}
the quantity $Z$ transforms as:
\begin{equation}\label{noninchgravortbstrjgghguittu1intmmjhhj}
Z'(\vec x',t')=Z(\vec x,t)+\vec w\cdot\vec x+\frac{1}{2}|\vec
w|^2t\quad\text{(modulo unspecified additive constant, independent
on $(\vec x,t)$)},
\end{equation}
then equalities
\begin{equation}
\label{MaxVacFull1ninshtrgravortghhghgjkgghklhjgkghghjjkjhjkkggjkhjkhjjhhfhjhklkhkhjjklzzzyyyhjggjhgghhjhNWNWNWNWNWBWHWPPN222kkk}
\begin{cases}
\vec v+\vec w=\vec v'=\nabla_{\vec x'}Z',\\
\frac{\partial Z'}{\partial t'}+\frac{1}{2}\left|\nabla_{\vec
x'}Z'\right|^2=-\Phi',
\end{cases}
\end{equation}
in coordinate system $\bf{(**)}$ imply the similar equalities
\begin{equation}
\label{MaxVacFull1ninshtrgravortghhghgjkgghklhjgkghghjjkjhjkkggjkhjkhjjhhfhjhklkhkhjjklzzzyyyhjggjhgghhjhNWNWNWNWNWBWHWPPN222kkkkjj}
\begin{cases}
\vec v=\nabla_{\vec x}Z,\\
\frac{\partial Z}{\partial t}+\frac{1}{2}\left|\nabla_{\vec
x}Z\right|^2=-\Phi,
\end{cases}
\end{equation}
in coordinate system $\bf{(*)}$, provided that $\Phi'=\Phi$ (see
Lemma \ref{ygjhgjgjgjhgjtuiuijk} for the details). More generally,
by Lemma \ref{ygjhgjgjgjhgjtu} we deduce that we can choose the
quantity $Z$ to be a speed-like field generator (see Definition
\ref{gjgjffjjjjjjjjjjjjjjjjjjjjjjhh2}) on the extended Galilean
group of all inertial cartesian coordinate systems. Moreover, again
by Lemma \ref{ygjhgjgjgjhgjtuiuijk}, in that case the equations in
\er{MaxVacFull1ninshtrgravortghhghgjkgghklhjgkghghjjkjhjkkggjkhjkhjjhhfhjhklkhkhjjklzzzyyyhjggjhgghhjhNWNWNWNWNWBWHWPPN222}
preserve their form in every inertial cartesian coordinate system.
\end{remark}
\begin{remark}\label{gygygygyggjh}
One can wonder: what should be possible values of the vectorial
gravitational potential $\vec v$ in the proximity of the Earth or
another massive body? In order to try to answer this question in the
case of the Newtonian-type gravity , given by
\er{MaxVacFull1ninshtrgravortghhghgjkgghklhjgkghghjjkjhjkkggjkhjkhjjhhfhjhklkhkhjjklzzzyyyNWBWHWPPN222ll},
consider two cartesian coordinate systems: \underline{non-rotating}
system $\bf(*)$ with the center that coincides with the center of
masses of the Earth and \underline{inertial} system $\bf(**)$
related to some external cosmic bodies. Assume that the center of
masses of the Earth has place $\vec R(t')$ and velocity $\vec
W(t'):=\frac{d\vec R}{dt'}(t')$ in the coordinate system $\bf(**)$.
Thus the change of coordinate system $\bf(*)$ to coordinate system
$\bf(**)$ is given by
\begin{equation}\label{hjjgghghghyu}
\begin{cases}
\vec x'=\vec x+\vec R(t),\\
t'=t,
\end{cases}
\end{equation}
and the vectorial gravitational potential $\vec v$, being a speed
like vector field, transforms as
\begin{equation}\label{hjjgghghghyulll}
\vec v'=\vec v+\vec W(t).
\end{equation}
Next, since the system $\bf(**)$ is inertial, consistently with
\er{MaxVacFull1ninshtrgravortghhghgjkgghklhjgkghghjjkjhjkkggjkhjkhjjhhfhjhklkhkhjjklzzzyyyhjggjhgghhjhNWNWBWHWPPN222}
and
\er{MaxVacFull1ninshtrgravortghhghgjkgghklhjgkghghjjkjhjkkggjkhjkhjjhhfhjhklkhkhjjklzzzyyyhjggjhgghhjhNWNWNWBWHWPPN222}
we have
\begin{equation}
\label{MaxVacFull1ninshtrgravortghhghgjkgghklhjgkghghjjkjhjkkggjkhjkhjjhhfhjhklkhkhjjklzzzyyyhjggjhgghhjhNWNWBWHWPPN222kkk}
\begin{cases}
curl_{\vec x'}\vec v'= 0,\\
\frac{\partial\vec v'}{\partial t'}+d_{\vec x'}\vec v'\cdot\vec v'=
-\nabla_{\vec x'}\Phi'_1-\nabla_{\vec x'}\Phi'_2,
\end{cases}
\end{equation}
with
\begin{equation}
\label{MaxVacFull1ninshtrgravortghhghgjkgghklhjgkghghjjkjhjkkggjkhjkhjjhhfhjhklkhkhjjklzzzyyyhjggjhgghhjhNWNWNWBWHWPPN222kkk}
\Delta_{\vec x'}\Phi'_1=4\pi GM'_1\quad\text{and}\quad\Delta_{\vec
x'}\Phi'_2=4\pi GM'_2,
\end{equation}
where $M_1$ is the gravitational mass density of the Earth and $M_2$
is the gravitational mass density of all other external cosmic
bodies like sun et.al. Moreover, again since the system $\bf(**)$ is
inertial, we clearly have:
\begin{equation}
\label{MaxVacFull1ninshtrgravortghhghgjkgghklhjgkghghjjkjhjkkggjkhjkhjjhhfhjhklkhkhjjklzzzyyyhjggjhgghhjhNWNWBWHWPPN222kkkjhjhjjhjhjhjhhjhj}
\frac{d\vec W}{\partial t}(t)=\frac{d\vec W}{\partial
t'}(t')=-\nabla_{\vec x'}\Phi'_2\left(\vec R(t),t\right).
\end{equation}
On the other hand inserting \er{hjjgghghghyu} and
\er{hjjgghghghyulll} into
\er{MaxVacFull1ninshtrgravortghhghgjkgghklhjgkghghjjkjhjkkggjkhjkhjjhhfhjhklkhkhjjklzzzyyyhjggjhgghhjhNWNWBWHWPPN222kkk}
and using Proposition \ref{yghgjtgyrtrt}
we deduce
\begin{equation}
\label{MaxVacFull1ninshtrgravortghhghgjkgghklhjgkghghjjkjhjkkggjkhjkhjjhhfhjhklkhkhjjklzzzyyyhjggjhgghhjhNWNWBWHWPPN222kkkgtytghjjh}
curl_{\vec x}\vec v=0,
\end{equation}
and
\begin{equation}
\label{MaxVacFull1ninshtrgravortghhghgjkgghklhjgkghghjjkjhjkkggjkhjkhjjhhfhjhklkhkhjjklzzzyyyhjggjhgghhjhNWNWBWHWPPN222kkkgtytghjjhghhg}
\frac{d\vec W}{\partial t}(t)+\frac{\partial\vec v}{\partial
t}+d_{\vec x}\vec v\cdot\vec v=\frac{d\vec W}{\partial
t'}(t')+\frac{\partial\vec v}{\partial t'}+d_{\vec x'}\vec
v\cdot\vec W(t')+d_{\vec x'}\vec v\cdot\vec v= -\nabla_{\vec
x'}\Phi'_1-\nabla_{\vec x'}\Phi'_2= -\nabla_{\vec
x}\Phi_1-\nabla_{\vec x'}\Phi'_2,
\end{equation}
On the other hand, the quantities $\nabla_{\vec
x'}\Phi'_2=\nabla_{\vec x}\Phi_2$ and $\frac{d\vec W}{\partial
t}(t)$, being generated by the gravitational field from the far
bodies, are insignificant, at the first approximation, with respect
to the quantity $\nabla_{\vec x}\Phi_1$ in the scale compatible to
the Earth size. Moreover, even if we wish to consider a finer
approximation then we can observe that  the quantity $\nabla_{\vec
x'}\Phi'_2=\nabla_{\vec x}\Phi_2$ vary insignificantly in the space
variables in the scale compatible to the Earth size and thus, using
\er{MaxVacFull1ninshtrgravortghhghgjkgghklhjgkghghjjkjhjkkggjkhjkhjjhhfhjhklkhkhjjklzzzyyyhjggjhgghhjhNWNWBWHWPPN222kkkjhjhjjhjhjhjhhjhj}
we get
\begin{equation}
\label{hklgkjgjgjgfhfhffjjohiiy} \nabla_{\vec x}\Phi_2\left(\vec
x,t\right)=\nabla_{\vec x'}\Phi'_2\left(\vec x',t'\right)\approx
\nabla_{\vec x'}\Phi'_2\left(\vec R(t),t\right)=-\frac{d\vec
W}{\partial t}(t).
\end{equation}
Therefore, in both cases by
\er{MaxVacFull1ninshtrgravortghhghgjkgghklhjgkghghjjkjhjkkggjkhjkhjjhhfhjhklkhkhjjklzzzyyyhjggjhgghhjhNWNWBWHWPPN222kkkgtytghjjh},
\er{MaxVacFull1ninshtrgravortghhghgjkgghklhjgkghghjjkjhjkkggjkhjkhjjhhfhjhklkhkhjjklzzzyyyhjggjhgghhjhNWNWBWHWPPN222kkkgtytghjjhghhg}
and \er{hklgkjgjgjgfhfhffjjohiiy} we finally deduce
\begin{equation}
\label{MaxVacFull1ninshtrgravortghhghgjkgghklhjgkghghjjkjhjkkggjkhjkhjjhhfhjhklkhkhjjklzzzyyyhjggjhgghhjhNWNWBWHWPPN222kkkgtytghjjhpop}
\begin{cases}
curl_{\vec x}\vec v=0,\\
\frac{\partial\vec v}{\partial t}+\frac{1}{2}\nabla_{\vec
x}\left(|\vec v|^2\right)=\frac{\partial\vec v}{\partial t}+d_{\vec
x}\vec v\cdot\vec v\approx -\nabla_{\vec x}\Phi_1,
\end{cases}
\end{equation}
where
\begin{equation}
\label{MaxVacFull1ninshtrgravortghhghgjkgghklhjgkghghjjkjhjkkggjkhjkhjjhhfhjhklkhkhjjklzzzyyyhjggjhgghhjhNWNWNWBWHWPPN222kkkiklklk}
\Delta_{\vec x}\Phi_1=4\pi GM_1.
\end{equation}
Being in the system  $\bf(*)$ which is stationary with respect to
the center of the Earth we look for stationary (i.e. time
independent) solutions of
\er{MaxVacFull1ninshtrgravortghhghgjkgghklhjgkghghjjkjhjkkggjkhjkhjjhhfhjhklkhkhjjklzzzyyyhjggjhgghhjhNWNWBWHWPPN222kkkgtytghjjhpop}.
Thus
\er{MaxVacFull1ninshtrgravortghhghgjkgghklhjgkghghjjkjhjkkggjkhjkhjjhhfhjhklkhkhjjklzzzyyyhjggjhgghhjhNWNWBWHWPPN222kkkgtytghjjhpop}
implies:
\begin{equation}
\label{MaxVacFull1ninshtrgravortghhghgjkgghklhjgkghghjjkjhjkkggjkhjkhjjhhfhjhklkhkhjjklzzzyyyhjggjhgghhjhNWNWBWHWPPN222kkkgtytghjjhpoppkkk}
\begin{cases}
curl_{\vec x}\vec v(\vec x)=0,\\
\left|\vec v(\vec x)\right|^2= -2\Phi_1(\vec x).
\end{cases}
\end{equation}
On the other hand, the scalar field $\Phi_1$, being the Newtonian
potential of the Earth, is radial and outside of the Earth surface
it is known that $\Phi_1(\vec x)=-\frac{Gm_0}{|\vec x|}$, where
$m_0$ is the Earth mass. Thus, since there exists a scalar field
$Z_0(\vec x)$ such that $\vec v(\vec x)=\nabla_{\vec x}Z_0(\vec x)$
and since of symmetry considerations $Z_0(\vec x)=Z_0(|\vec x|)$
should be radial, by
\er{MaxVacFull1ninshtrgravortghhghgjkgghklhjgkghghjjkjhjkkggjkhjkhjjhhfhjhklkhkhjjklzzzyyyhjggjhgghhjhNWNWBWHWPPN222kkkgtytghjjhpoppkkk}
we obtain
\begin{equation}
\label{MaxVacFull1ninshtrgravortghhghgjkgghklhjgkghghjjkjhjkkggjkhjkhjjhhfhjhklkhkhjjklzzzyyyhjggjhgghhjhNWNWBWHWPPN222kkkgtytghjjhpoppkkkhhhk}
\left|\frac{dZ_0}{d(|\vec x|)}(|\vec x|)\right|= \sqrt{-2\Phi_1(\vec
x)},
\end{equation}
that implies either
\begin{equation}
\label{MaxVacFull1ninshtrgravortghhghgjkgghklhjgkghghjjkjhjkkggjkhjkhjjhhfhjhklkhkhjjklzzzyyyhjggjhgghhjhNWNWBWHWPPN222kkkgtytghjjhpoppkkkhhhkhjhj}
\vec v(\vec x)= \frac{\sqrt{-2\Phi_1(|\vec x|)}}{|\vec x|}\vec x,
\end{equation}
or
\begin{equation}
\label{MaxVacFull1ninshtrgravortghhghgjkgghklhjgkghghjjkjhjkkggjkhjkhjjhhfhjhklkhkhjjklzzzyyyhjggjhgghhjhNWNWBWHWPPN222kkkgtytghjjhpoppkkkhhhkhjhjiuu}
\vec v(\vec x)= -\frac{\sqrt{-2\Phi_1(|\vec x|)}}{|\vec x|}\vec x.
\end{equation}
In particular on the Earth surface we have:
\begin{equation}
\label{MaxVacFull1ninshtrgravortghhghgjkgghklhjgkghghjjkjhjkkggjkhjkhjjhhfhjhklkhkhjjklzzzyyyhjggjhgghhjhNWNWBWHWPPN222kkkgtytghjjhpoppkkkhhhkhjhjiuuokokok}
|\vec v|=\sqrt{\frac{2Gm_0}{r_0}},
\end{equation}
where $r_0$ is the Earth radius and $m_0$ is the Earth mass, i.e.
the absolute value of the vectorial gravitational potential on the
Earth surface approximately equals to the escape velocity and its
direction is normal to the Earth, either downward or upward.
\end{remark}

\subsection{Variational principle for the solution of the Cauchy-problem for Hamilton-Jacobi type equation
\er{MaxVacFull1ninshtrgravortghhghgjkgghklhjgkghghjjkjhjkkggjkhjkhjjhhfhjhklkhkhjjklzzzyyyhjggjhgghhjhNWNWNWNWNWBWHWPPN222}.}\label{GGGGGZZZZffffhhh}
Consider the Hamilton-Jacobi type equation
\er{MaxVacFull1ninshtrgravortghhghgjkgghklhjgkghghjjkjhjkkggjkhjkhjjhhfhjhklkhkhjjklzzzyyyhjggjhgghhjhNWNWNWNWNWBWHWPPN222}
for the vectorial gravitational potential in the non-rotating
coordinate system:
\begin{equation}
\label{MaxVacFull1ninshtrgravortghhghgjkgghklhjgkghghjjkjhjkkggjkhjkhjjhhfhjhklkhkhjjklzzzyyyhjggjhgghhjhNWNWNWNWNWBWHWPPN222RR}
\begin{cases}
\vec v=\nabla_{\vec x}Z,\\
\frac{\partial Z}{\partial t}+\frac{1}{2}\left|\nabla_{\vec
x}Z\right|^2=-\Phi,
\end{cases}
\end{equation}
where $Z:=Z(\vec x,t)$ is some scalar field and $\Phi=\Phi(\vec
x,t)$ is the scalar Newtonian gravitational potential which
satisfies
\er{MaxVacFull1ninshtrgravortghhghgjkgghklhjgkghghjjkjhjkkggjkhjkhjjhhfhjhklkhkhjjklzzzyyyhjggjhgghhjhNWNWNWBWHWPPN222}:
\begin{equation}
\label{MaxVacFull1ninshtrgravortghhghgjkgghklhjgkghghjjkjhjkkggjkhjkhjjhhfhjhklkhkhjjklzzzyyyhjggjhgghhjhNWNWNWBWHWPPN222RR}
\Delta_{\vec x}\Phi=4\pi GM,
\end{equation}
where $M$ is the gravitational mass density and $G$ is the
gravitational constant. Next consider the Cauchy problem for
\er{MaxVacFull1ninshtrgravortghhghgjkgghklhjgkghghjjkjhjkkggjkhjkhjjhhfhjhklkhkhjjklzzzyyyhjggjhgghhjhNWNWNWNWNWBWHWPPN222RR}:
\begin{equation}
\label{MaxVacFull1ninshtrgravortghhghgjkgghklhjgkghghjjkjhjkkggjkhjkhjjhhfhjhklkhkhjjklzzzyyyhjggjhgghhjhNWNWNWNWNWBWHWPPN222RRkmmjljlm}
\begin{cases}
\frac{\partial Z}{\partial t}(\vec
x,t)+\frac{1}{2}\left|\nabla_{\vec x}Z(\vec x,t)\right|^2=-\Phi(\vec
x,t),\\
Z(\vec x,0)=\varphi(\vec x),
\end{cases}
\end{equation}
where $\Phi(\vec x,t)$ and $\varphi(\vec x)$ are prescribed smooth
scalar functions.
Furthermore, for
every instant of time $t\geq 0$ consider the followed variational
functional defined on trajectories $\vec r(s):[0,t]\to\R^3$:
\begin{equation}\label{gjgjfhhhhhhhhhhhhhhhhhjjjjk}
J_{gr}\left(\vec r\right)=\varphi\left(\vec
r(0)\right)+\int_0^t\left(\frac{1}{2}\left|\frac{d\vec
r}{ds}(s)\right|^2-\Phi\left(\vec r(s),s\right)\right)ds.
\end{equation}
Then inserting the smooth solution $Z$ of
\er{MaxVacFull1ninshtrgravortghhghgjkgghklhjgkghghjjkjhjkkggjkhjkhjjhhfhjhklkhkhjjklzzzyyyhjggjhgghhjhNWNWNWNWNWBWHWPPN222RRkmmjljlm}
into \er{gjgjfhhhhhhhhhhhhhhhhhjjjjk} and using the Chain Rule we
deduce:
\begin{multline}\label{gjgjfhhhhhhhhhhhhhhhhhjjjjklmjkkhk}
J_{gr}\left(\vec r\right)=Z\left(\vec
r(0),0\right)+\int_0^t\left(\frac{1}{2}\left|\frac{d\vec
r}{ds}(s)\right|^2+\frac{\partial Z}{\partial s}\left(\vec
r(s),s\right)+\frac{1}{2}\left|\nabla_{\vec {\vec r}}Z\left(\vec
r(s),s\right)\right|^2\right)ds=\\ Z\left(\vec
r(0),0\right)+\int_0^t\left(\frac{1}{2}\left|\frac{d\vec
r}{ds}(s)\right|^2+\frac{d}{ds}\left(Z\left(\vec
r(s),s\right)\right)-\nabla_{\vec {\vec r}}Z\left(\vec
r(s),s\right)\cdot\frac{d\vec
r}{ds}(s)+\frac{1}{2}\left|\nabla_{\vec {\vec r}}Z\left(\vec
r(s),s\right)\right|^2\right)ds
\\ =Z\left(\vec
r(0),0\right)+\int_0^t\frac{d}{ds}\left(Z\left(\vec
r(s),s\right)\right)ds +\frac{1}{2}\int_0^t\left|\frac{d\vec
r}{ds}(s)-\nabla_{\vec {\vec r}}Z\left(\vec r(s),s\right)\right|^2ds
\\ =Z\left(\vec
r(0),0\right)+\left(Z\left(\vec r(t),t\right)-Z\left(\vec
r(0),0\right)\right) +\frac{1}{2}\int_0^t\left|\frac{d\vec
r}{ds}(s)-\nabla_{\vec {\vec r}}Z\left(\vec r(s),s\right)\right|^2ds
\\
=Z\left(\vec r(t),t\right)+\frac{1}{2}\int_0^t\left|\frac{d\vec
r}{ds}(s)-\nabla_{\vec {\vec r}}Z\left(\vec
r(s),s\right)\right|^2ds.
\end{multline}
Thus we rewrite the definition of the functional $J_{gr}$ in
\er{gjgjfhhhhhhhhhhhhhhhhhjjjjk} as:
\begin{multline}\label{gjgjfhhhhhhhhhhhhhhhhhjjjjklmjkkhkjjjjmjlj}
J_{gr}\left(\vec r\right)=Z\left(\vec
r(t),t\right)+\frac{1}{2}\int_0^t\left|\frac{d\vec
r}{ds}(s)-\nabla_{\vec {\vec r}}Z\left(\vec
r(s),s\right)\right|^2ds=Z\left(\vec
r(t),t\right)+\frac{1}{2}\int_0^t\left|\frac{d\vec r}{ds}(s)-\vec
v\left(\vec r(s),s\right)\right|^2ds,
\end{multline}
where consistently with
\er{MaxVacFull1ninshtrgravortghhghgjkgghklhjgkghghjjkjhjkkggjkhjkhjjhhfhjhklkhkhjjklzzzyyyhjggjhgghhjhNWNWNWNWNWBWHWPPN222RR}
we consider $\vec v(\vec x,t):=\nabla_{\vec x}Z(\vec x,t)$.
Therefore, if for every point $\vec x\in\R^3$ and every instant of
time $t\geq 0$ we consider the function $g:=g(\vec
x,t):\R^3:[0,+\infty)\to\R$ defined as a minimum of the following
variational problem:
\begin{multline}\label{gjgjfhhhhhhhhhhhhhhhhhjjjjkhhhjhjjhh}
g(\vec x,t):=\min{\left\{J_{gr}\left(\vec r\right)\;\mid\;\;\vec
r(s):[0,t]\to\R^3,\;r(t)=\vec x\right\}}=\\
\min{\left\{\varphi\left(\vec
r(0)\right)+\int_0^t\left(\frac{1}{2}\left|\frac{d\vec
r}{ds}(s)\right|^2-\Phi\left(\vec
r(s),s\right)\right)ds\;\mid\;\;\vec r(s):[0,t]\to\R^3,\;\vec
r(t)=\vec x\right\}},
\end{multline}
then by \er{gjgjfhhhhhhhhhhhhhhhhhjjjjklmjkkhkjjjjmjlj} we deduce
that
\begin{equation}\label{gjgjfhhhhhhhhhhhhhhhhhjjjjkhhhjhjjhhjkjhh}
g(\vec x,t)=Z(\vec x,t),
\end{equation}
and this minimum is achieved on the unique trajectory $\vec
r(s):[0,t]\to\R^3$ that satisfies the following initial value
problem for an ordinary differential equation:
\begin{equation}
\label{MaxVacFull1ninshtrgravortghhghgjkgghklhjgkghghjjkjhjkkggjkhjkhjjhhfhjhklkhkhjjklzzzyyyhjggjhgghhjhNWNWNWNWNWBWHWPPN222RRkkljljkhhjh}
\begin{cases}
\frac{d\vec r}{ds}(s)=\nabla_{\vec {\vec r}}Z\left(\vec
r(s),s\right)=\vec v\left(\vec r(s),s\right)\quad\quad\forall
s\in[0,t],\\
\vec r(t)=\vec x.
\end{cases}
\end{equation}
So by inserting \er{gjgjfhhhhhhhhhhhhhhhhhjjjjkhhhjhjjhhjkjhh} into
\er{gjgjfhhhhhhhhhhhhhhhhhjjjjkhhhjhjjhh} we obtain the
\underline{explicit} formula for the solution $Z$ of the Cauchy
problem
\er{MaxVacFull1ninshtrgravortghhghgjkgghklhjgkghghjjkjhjkkggjkhjkhjjhhfhjhklkhkhjjklzzzyyyhjggjhgghhjhNWNWNWNWNWBWHWPPN222RRkmmjljlm}:
\begin{multline}\label{gjgjfhhhhhhhhhhhhhhhhhjjjjkhhhjhjjhhggg}
Z(\vec x,t)=\min{\left\{\varphi\left(\vec
r(0)\right)+\int_0^t\left(\frac{1}{2}\left|\frac{d\vec
r}{ds}(s)\right|^2-\Phi\left(\vec
r(s),s\right)\right)ds\;\mid\;\;\vec r(s):[0,t]\to\R^3,\;\vec
r(t)=\vec x\right\}}.
\end{multline}
Furthermore, there exists a unique trajectory $\vec
r(s):[0,t]\to\R^3$ that minimizes the right-hand-side of
\er{gjgjfhhhhhhhhhhhhhhhhhjjjjkhhhjhjjhhggg} and moreover it
satisfies:
\begin{equation}
\label{MaxVacFull1ninshtrgravortghhghgjkgghklhjgkghghjjkjhjkkggjkhjkhjjhhfhjhklkhkhjjklzzzyyyhjggjhgghhjhNWNWNWNWNWBWHWPPN222RRkkljljkhhjhjhhhjh}
\begin{cases}
\frac{d\vec r}{ds}(s)=\nabla_{\vec {\vec r}}Z\left(\vec
r(s),s\right)=\vec v\left(\vec r(s),s\right)\quad\quad\forall
s\in[0,t],\\
\vec r(t)=\vec x.
\end{cases}
\end{equation}
In particular for this trajectory we have
\begin{equation}\label{MaxVacFull1ninshtrgravortghhghgjkgghklhjgkghghjjkjhjkkggjkhjkhjjhhfhjhklkhkhjjklzzzyyyhjggjhgghhjhNWNWNWNWNWBWHWPPN222RRkkljljkhhjhjhhhjhkjjg}
\vec v\left(\vec x,t\right)=\frac{d\vec r}{ds}(t).
\end{equation}
On the other hand the minimizer of the right-hand-side of
\er{gjgjfhhhhhhhhhhhhhhhhhjjjjkhhhjhjjhhggg} clearly satisfies the
following Euler-Lagrange equation:
\begin{equation}
\label{MaxVacFull1ninshtrgravortghhghgjkgghklhjgkghghjjkjhjkkggjkhjkhjjhhfhjhklkhkhjjklzzzyyyhjggjhgghhjhNWNWNWNWNWBWHWPPN222RRkkljljkhhjhjhhhjhjkh}
\begin{cases}
\frac{d^2\vec r}{ds^2}(s)=-\nabla_{\vec {\vec r}}\Phi\left(\vec
r(s),s\right)\quad\quad\forall
s\in[0,t],\\
\frac{d\vec r}{ds}(0)=\nabla_{\vec r}\varphi\left(\vec r(0)\right),
\\
\vec r(t)=\vec x.
\end{cases}
\end{equation}
Equation
\er{MaxVacFull1ninshtrgravortghhghgjkgghklhjgkghghjjkjhjkkggjkhjkhjjhhfhjhklkhkhjjklzzzyyyhjggjhgghhjhNWNWNWNWNWBWHWPPN222RRkkljljkhhjhjhhhjhjkh}
sometimes has an advantage with respect to
\er{MaxVacFull1ninshtrgravortghhghgjkgghklhjgkghghjjkjhjkkggjkhjkhjjhhfhjhklkhkhjjklzzzyyyhjggjhgghhjhNWNWNWNWNWBWHWPPN222RRkkljljkhhjhjhhhjh}
since the a priori unknown function $Z$ dose not appear in
\er{MaxVacFull1ninshtrgravortghhghgjkgghklhjgkghghjjkjhjkkggjkhjkhjjhhfhjhklkhkhjjklzzzyyyhjggjhgghhjhNWNWNWNWNWBWHWPPN222RRkkljljkhhjhjhhhjhjkh}.

So, in order to find $\vec v(\vec x,t)$ we need first to solve
\er{MaxVacFull1ninshtrgravortghhghgjkgghklhjgkghghjjkjhjkkggjkhjkhjjhhfhjhklkhkhjjklzzzyyyhjggjhgghhjhNWNWNWNWNWBWHWPPN222RRkkljljkhhjhjhhhjhjkh}
and then use
\er{MaxVacFull1ninshtrgravortghhghgjkgghklhjgkghghjjkjhjkkggjkhjkhjjhhfhjhklkhkhjjklzzzyyyhjggjhgghhjhNWNWNWNWNWBWHWPPN222RRkkljljkhhjhjhhhjhkjjg}
with the solution of
\er{MaxVacFull1ninshtrgravortghhghgjkgghklhjgkghghjjkjhjkkggjkhjkhjjhhfhjhklkhkhjjklzzzyyyhjggjhgghhjhNWNWNWNWNWBWHWPPN222RRkkljljkhhjhjhhhjhjkh}
that we just found.

\subsection{Genuine gravity and inertia}\label{jlhkgkhfgjdfhd}
\begin{definition}\label{jjhgyftdtd}
In general, both in the case of the simplest model of the Newtonian
gravity and in the case of any other alternative model of gravity,
we \underline{always} assume that the vectorial gravitational
potential $\vec v:=\vec v(\vec x,t)$ is an asymptotically acceptable
speed-like vector field (see Definition \ref{jhjkhhjhjhgjg;kkljkl}
and Proposition \ref{jhjkhhjhjhgjgjjkjkjklljj}). Therefore,
Corollary \ref{jhjkhhjhjhgjgjjkjkjhhjhCC} implies that there exist a
uniquely defined \underline{generally trivial speed-like} vector
field $\,\vec k:=\vec k(\vec x,t)\in\mathbb{R}^3$ (see Definition
\ref{jhjkhhjhjhgjg}) and a uniquely defined \underline{proper}
vector field $\,\vec h:=\vec h(\vec x,t)\in\mathbb{R}^3$, so that in
every cartesian coordinate system we have
\begin{equation}
\label{jljjjjkhhhjhjhjhljjjljljouukhhjhjgCChhhh} \lim\limits_{|\vec
x|\to+\infty}\vec h(\vec x,t)=0\quad\quad\forall
t\quad\quad\text{and}\quad\quad\lim\limits_{|\vec
x|\to+\infty}\left(d_{\vec x}\vec h(\vec
x,t)\right)=0\quad\quad\forall t\,,
\end{equation}
and in every cartesian coordinate system we can decompose vector
field $\vec v$ as:
\begin{equation}
\label{jljjjjkhhhjhjhjhljjjljljkjkCChhhh} \vec v(\vec x,t)=\vec
h(\vec x,t)+\vec k(\vec x,t)\quad\quad\quad\quad\forall (\vec x,t).
\end{equation}
Then we call $\vec h$ \underline{vectorial potential of genuine
gravity} and $\vec k$ \underline{vectorial potential of inertia}. In
particular, it is clear that, given cartesian coordinate system
$(*)$, this system will be \underline{inertial} (in the frames of
Definition \ref{jggggggggggggggggggggjklj}) if and only if the
vectorial potential of inertia $\vec k$ is a \underline{constant}
(independent of $(\vec x,t)$) in the coordinate system $(*)$.
\end{definition}
The name of genuine gravity and inertia is clarified by the fact
that since $\vec k$ is \underline{generally trivial} speed-like
vector field, by the definition, we can completely eliminate $\vec
k$ (getting $\vec k=0$) by the simple change of cartesian coordinate
system (it is fictitious). On the other hand, if $\vec h$ is
nontrivial, since $\vec h$ is a \underline{proper} vector field, we
cannot make it trivial in any other coordinate system (it is
genuine). Moreover, the vector of inertia $\vec k$ depends only on
the coordinate system in the space and it is completely independent
on the physical matter or physical fields filling this space. In
contrast vectorial potential of genuine gravity $\vec h$ depends
essentially on the surrounding physical matter (in the model of the
Newtonian gravity through gravitational masses).

Next, assume the model of Newtonian gravity, given by
\er{MaxVacFull1ninshtrgravortghhghgjkgghklhjgkghghjjkjhjkkggjkhjkhjjhhfhjhklkhkhjjklzzzyyyNWBWHWPPN222ll},
and assume that $(*)$ is some inertial (in the frames of Definition
\ref{jggggggggggggggggggggjklj}) cartesian coordinate system. Then
by Proposition \ref{jghjgyfyffffhfgljjkjl} in the system $(*)$ we
have:
\begin{equation}
\label{MaxVacFull1ninshtrgravortghhghgjkgghklhjgkghghjjkjhjkkggjkhjkhjjhhfhjhklkhkhjjklzzzyyyhjggjhgghhjhNWNWBWHWPPN222hhhh}
\begin{cases}
curl_{\vec x}\vec v= 0,\\
\frac{\partial\vec v}{\partial t}+d_\vec x\vec v\cdot\vec v=
-\nabla_{\vec x}\Phi,
\end{cases}
\end{equation}
where $\Phi$ is a scalar Newtonian gravitational potential which is
assumed to be a proper scalar, which satisfies
\begin{equation}
\label{MaxVacFull1ninshtrgravortghhghgjkgghklhjgkghghjjkjhjkkggjkhjkhjjhhfhjhklkhkhjjklzzzyyyhjggjhgghhjhNWNWNWBWHWPPN222hhhh}
\begin{cases}
\Delta_{\vec x}\Phi=4\pi GM\quad\quad\quad\forall \,(\vec x,t),\\
\lim\limits_{|\vec x|\to+\infty}\nabla_{\vec x}\Phi(\vec
x,t)=0\quad\quad\quad\forall t,
\end{cases}
\end{equation}
with $M$ being the gravitational mass density and $G$ being the
gravitational constant. Thus inserting
\er{jljjjjkhhhjhjhjhljjjljljkjkCChhhh} into
\er{MaxVacFull1ninshtrgravortghhghgjkgghklhjgkghghjjkjhjkkggjkhjkhjjhhfhjhklkhkhjjklzzzyyyhjggjhgghhjhNWNWBWHWPPN222hhhh}
and using the fact that in the inertial system $(*)$ the vector
$\vec k$ is a constant, we deduce
\begin{equation}
\label{MaxVacFull1ninshtrgravortghhghgjkgghklhjgkghghjjkjhjkkggjkhjkhjjhhfhjhklkhkhjjklzzzyyyhjggjhgghhjhNWNWBWHWPPN222hhhhkjhkhk}
\begin{cases}
curl_{\vec x}\vec h= 0,\\
\frac{\partial\vec h}{\partial t}+d_\vec x\vec h\cdot(\vec k+\vec
h)= -\nabla_{\vec x}\Phi.
\end{cases}
\end{equation}
In particular, by
\er{MaxVacFull1ninshtrgravortghhghgjkgghklhjgkghghjjkjhjkkggjkhjkhjjhhfhjhklkhkhjjklzzzyyyhjggjhgghhjhNWNWBWHWPPN222hhhhkjhkhk},
using the fact $curl_{\vec x}\vec h=0$ we deduce:
\begin{equation}
\label{MaxVacFull1ninshtrgravortghhghgjkgghklhjgkghghjjkjhjkkggjkhjkhjjhhfhjhklkhkhjjklzzzyyyhjggjhgghhjhNWNWBWHWPPN222hhhhkjhkhkjkhh}
\begin{cases}
curl_{\vec x}\vec h= 0,\\
\frac{\partial\vec h}{\partial t}-(\vec k+\vec h)\times curl_{\vec
x}\vec h+\{d_\vec x\vec h\}^T\cdot(\vec k+\vec h)= -\nabla_{\vec
x}\Phi,
\end{cases}
\end{equation}
that we finally rewrite as:
\begin{equation}
\label{MaxVacFull1ninshtrgravortghhghgjkgghklhjgkghghjjkjhjkkggjkhjkhjjhhfhjhklkhkhjjklzzzyyyhjggjhgghhjhNWNWBWHWPPN222hhhhkjhkhkjkhhjkkhhkjkljkkhkh}
\begin{cases}
curl_{\vec x}\vec h= 0,\\
\frac{\partial\vec h}{\partial t}-\vec v\times curl_{\vec x}\vec
h+\nabla_{\vec x}\left(\vec v\cdot\vec h\right)-\nabla_{\vec
x}\left(\frac{1}{2}\left|\vec h\right|^2\right)=-\nabla_{\vec
x}\Phi,
\end{cases}
\end{equation}
or alternatively as:
\begin{equation}
\label{MaxVacFull1ninshtrgravortghhghgjkgghklhjgkghghjjkjhjkkggjkhjkhjjhhfhjhklkhkhjjklzzzyyyhjggjhgghhjhNWNWBWHWPPN222hhhhkjhkhkjkhhjkkhhkjkljkkhkh11}
\begin{cases}
curl_{\vec x}\vec h= 0,\\
\frac{\partial\vec h}{\partial t}-\vec k\times curl_{\vec x}\vec
h+\nabla_{\vec x}\left(\vec k\cdot\vec h\right)+\nabla_{\vec
x}\left(\frac{1}{2}\left|\vec h\right|^2\right)=-\nabla_{\vec
x}\Phi.
\end{cases}
\end{equation}
However, since $\vec h$ is a proper vector field, by Proposition
\ref{yghgjtgyrtrt} we deduce, that both
\er{MaxVacFull1ninshtrgravortghhghgjkgghklhjgkghghjjkjhjkkggjkhjkhjjhhfhjhklkhkhjjklzzzyyyhjggjhgghhjhNWNWBWHWPPN222hhhhkjhkhkjkhhjkkhhkjkljkkhkh}
and
\er{MaxVacFull1ninshtrgravortghhghgjkgghklhjgkghghjjkjhjkkggjkhjkhjjhhfhjhklkhkhjjklzzzyyyhjggjhgghhjhNWNWBWHWPPN222hhhhkjhkhkjkhhjkkhhkjkljkkhkh11}
preserve their form also in \underline{non-inertial} cartesian
coordinate systems, provided $\Phi$ is a proper scalar, which is
defined in \underline{every} inertial or non-inertial cartesian
coordinate system by the following:
\begin{equation}
\label{MaxVacFull1ninshtrgravortghhghgjkgghklhjgkghghjjkjhjkkggjkhjkhjjhhfhjhklkhkhjjklzzzyyyhjggjhgghhjhNWNWNWBWHWPPN222hhhhjkhkjhjhjhjkl}
\begin{cases}
\Delta_{\vec x}\Phi=4\pi GM\quad\quad\quad\forall \,(\vec x,t),\\
\lim\limits_{|\vec x|\to+\infty}\nabla_{\vec x}\Phi(\vec
x,t)=0\quad\quad\quad\forall t.
\end{cases}
\end{equation}
Finally, by Proposition \ref{jhjkhhjhjhgjgjjkjk} vector field $\vec
k$ satisfies
\begin{equation}
\label{MaxVacFull1ninshtrgravortghhghgjkgghklhjgkghghjjkjhjkkggjkhjkhjjhhfhjhklkhkhjjklzzzyyyhjggjhgghhjhNWNWNWBWHWPPN222hhhhjkhkjhjhjhhkhhjhjkjhmjljkl}
d_{\vec x}\vec k+\left\{d_{\vec x}\vec
k\right\}^T=0\quad\quad\quad\forall \,(\vec x,t)\,,
\end{equation}
in \underline{every} inertial or non-inertial coordinate system.
Therefore the Newtonian gravity in the form
\er{MaxVacFull1ninshtrgravortghhghgjkgghklhjgkghghjjkjhjkkggjkhjkhjjhhfhjhklkhkhjjklzzzyyyNWBWHWPPN222ll},
can be rewritten in the terms of genuine gravity and inertia as
\er{MaxVacFull1ninshtrgravortghhghgjkgghklhjgkghghjjkjhjkkggjkhjkhjjhhfhjhklkhkhjjklzzzyyyhjggjhgghhjhNWNWNWBWHWPPN222hhhhjkhkjhjhjhjkl}
and either
\er{MaxVacFull1ninshtrgravortghhghgjkgghklhjgkghghjjkjhjkkggjkhjkhjjhhfhjhklkhkhjjklzzzyyyhjggjhgghhjhNWNWBWHWPPN222hhhhkjhkhkjkhhjkkhhkjkljkkhkh}
or, alternatively,
\er{MaxVacFull1ninshtrgravortghhghgjkgghklhjgkghghjjkjhjkkggjkhjkhjjhhfhjhklkhkhjjklzzzyyyhjggjhgghhjhNWNWBWHWPPN222hhhhkjhkhkjkhhjkkhhkjkljkkhkh11},
that are complemented with
\er{MaxVacFull1ninshtrgravortghhghgjkgghklhjgkghghjjkjhjkkggjkhjkhjjhhfhjhklkhkhjjklzzzyyyhjggjhgghhjhNWNWNWBWHWPPN222hhhhjkhkjhjhjhhkhhjhjkjhmjljkl}
and \er{jljjjjkhhhjhjhjhljjjljljkjkCChhhh}. Moreover, note again
that
\er{MaxVacFull1ninshtrgravortghhghgjkgghklhjgkghghjjkjhjkkggjkhjkhjjhhfhjhklkhkhjjklzzzyyyhjggjhgghhjhNWNWNWBWHWPPN222hhhhjkhkjhjhjhjkl},
\er{MaxVacFull1ninshtrgravortghhghgjkgghklhjgkghghjjkjhjkkggjkhjkhjjhhfhjhklkhkhjjklzzzyyyhjggjhgghhjhNWNWBWHWPPN222hhhhkjhkhkjkhhjkkhhkjkljkkhkh},
\er{MaxVacFull1ninshtrgravortghhghgjkgghklhjgkghghjjkjhjkkggjkhjkhjjhhfhjhklkhkhjjklzzzyyyhjggjhgghhjhNWNWBWHWPPN222hhhhkjhkhkjkhhjkkhhkjkljkkhkh11},
\er{MaxVacFull1ninshtrgravortghhghgjkgghklhjgkghghjjkjhjkkggjkhjkhjjhhfhjhklkhkhjjklzzzyyyhjggjhgghhjhNWNWNWBWHWPPN222hhhhjkhkjhjhjhhkhhjhjkjhmjljkl},
and \er{jljjjjkhhhjhjhjhljjjljljkjkCChhhh} are invariant under the
change of inertial or non-inertial cartesian coordinate system.
%
%
%
%
%
%

Next note that, since we have $curl_{\vec x}\vec h=0$, we can write
either
\er{MaxVacFull1ninshtrgravortghhghgjkgghklhjgkghghjjkjhjkkggjkhjkhjjhhfhjhklkhkhjjklzzzyyyhjggjhgghhjhNWNWBWHWPPN222hhhhkjhkhkjkhhjkkhhkjkljkkhkh}
or
\er{MaxVacFull1ninshtrgravortghhghgjkgghklhjgkghghjjkjhjkkggjkhjkhjjhhfhjhklkhkhjjklzzzyyyhjggjhgghhjhNWNWBWHWPPN222hhhhkjhkhkjkhhjkkhhkjkljkkhkh11}
as the following Hamilton-Jacobi type equation:
\begin{equation}
\label{MaxVacFull1ninshtrgravortghhghgjkgghklhjgkghghjjkjhjkkggjkhjkhjjhhfhjhklkhkhjjklzzzyyyhjggjhgghhjhNWNWNWNWNWBWHWPPN222jujhkhhjhgjgg}
\begin{cases}
\vec h=\nabla_{\vec x}Y,\\
\frac{\partial Y}{\partial t}+\vec v\cdot\nabla_{\vec
x}Y-\frac{1}{2}\left|\nabla_{\vec x}Y\right|^2=\frac{\partial
Y}{\partial t}+\vec k\cdot\nabla_{\vec
x}Y+\frac{1}{2}\left|\nabla_{\vec x}Y\right|^2=-\Phi,
\end{cases}
\end{equation}
where $Y:=Y(\vec x,t)$ is some scalar field. Then, since $\vec h$ is
a proper vector field and $\Phi$ is proper scalar field,  by
Proposition \ref{yghgjtgyrtrt} we deduce, that
\er{MaxVacFull1ninshtrgravortghhghgjkgghklhjgkghghjjkjhjkkggjkhjkhjjhhfhjhklkhkhjjklzzzyyyhjggjhgghhjhNWNWNWNWNWBWHWPPN222jujhkhhjhgjgg}
is invariant under the changes of inertial or non-inertial cartesian
coordinate systems, provided that we consider $Y$ to be a
\underline{proper} scalar field. Note here that, in contrast to the
fact that the quantity $Z$ and Hamilton-Jacobi equation
\er{MaxVacFull1ninshtrgravortghhghgjkgghklhjgkghghjjkjhjkkggjkhjkhjjhhfhjhklkhkhjjklzzzyyyhjggjhgghhjhNWNWNWNWNWBWHWPPN222}
were defined only in inertial coordinate systems, the quantity $Y$
and Hamilton-Jacobi equation
\er{MaxVacFull1ninshtrgravortghhghgjkgghklhjgkghghjjkjhjkkggjkhjkhjjhhfhjhklkhkhjjklzzzyyyhjggjhgghhjhNWNWNWNWNWBWHWPPN222jujhkhhjhgjgg}
are defined in every inertial or non-inertial coordinate system.
Next in every \underline{inertial} cartesian coordinate system
define the scalar field $X:=X(\vec x,t)$ by
\begin{equation}
\label{MaxVacFull1ninshtrgravortghhghgjkgghklhjgkghghjjkjhjkkggjkhjkhjjhhfhjhklkhkhjjklzzzyyyhjggjhgghhjhNWNWNWNWNWBWHWPPN222jujhkhhjhgjggljkyiyhkh}
X(\vec x,t)=Z(\vec x,t)-Y(\vec x,t)\quad\quad\forall(\vec x,t)\,.
\end{equation}
Then, by the first equation in
\er{MaxVacFull1ninshtrgravortghhghgjkgghklhjgkghghjjkjhjkkggjkhjkhjjhhfhjhklkhkhjjklzzzyyyhjggjhgghhjhNWNWNWNWNWBWHWPPN222}
and the first equation in
\er{MaxVacFull1ninshtrgravortghhghgjkgghklhjgkghghjjkjhjkkggjkhjkhjjhhfhjhklkhkhjjklzzzyyyhjggjhgghhjhNWNWNWNWNWBWHWPPN222jujhkhhjhgjgg},
together with \er{jljjjjkhhhjhjhjhljjjljljkjkCChhhh}, we deduce
\begin{equation}
\label{jljjjjkhhhjhjhjhljjjljljkjkCChhhhjjjkkh}
\vec k(\vec x,t)=
\nabla_{\vec x}X(\vec x,t)\quad\quad\quad\quad\forall (\vec x,t),
\end{equation}
in every inertial cartesian coordinate system. Moreover, since the
quantity $Y$ is a proper scalar and since by Remark \ref{ugyugg} the
quantity $Z$ is a speed-like field generator (see Definition
\ref{gjgjffjjjjjjjjjjjjjjjjjjjjjjhh2}) on the extended Galilean
group of all inertial cartesian coordinate systems, by
\er{MaxVacFull1ninshtrgravortghhghgjkgghklhjgkghghjjkjhjkkggjkhjkhjjhhfhjhklkhkhjjklzzzyyyhjggjhgghhjhNWNWNWNWNWBWHWPPN222jujhkhhjhgjggljkyiyhkh}
we deduce that the quantity $X$ in
\er{jljjjjkhhhjhjhjhljjjljljkjkCChhhhjjjkkh} is also a speed-like
field generator on the extended Galilean group of all inertial
cartesian coordinate systems. In particular, under the change of
coordinate system $\bf{(*)}$ to $\bf{(**)}$ given by the Galilean
Transformation
\begin{equation}\label{noninchgravortbstrjgghguittu1intmmnn}
\begin{cases}
\vec x'=\vec x+\vec wt,\\
t'=t,
\end{cases}
\end{equation}
the quantity $X$ transforms as:
\begin{equation}\label{noninchgravortbstrjgghguittu1intmmjhhjnn}
X'(\vec x',t')=X(\vec x,t)+\vec w\cdot\vec x+\frac{1}{2}|\vec
w|^2t\quad\text{(modulo unspecified additive constant, independent
on $(\vec x,t)$)}.
\end{equation}
Finally note that, since in every inertial cartesian system $\vec k$
is a constant independent of $(\vec x,t)$, by
\er{jljjjjkhhhjhjhjhljjjljljkjkCChhhhjjjkkh} we deduce that the
quantity $X(\vec x,t)$ is a linear function of $(\vec x,t)$.

\section{Maxwell equations revised}\label{MaxRevsPPN} We would like
to make the laws of Electrodynamics in the vacuum to be invariant
under the Galilean transformations. For this purpose we refer to the
analogy with the Maxwell equations in a medium. It is well known
that
the classical Maxwell equations in a medium have the form of
\begin{equation}\label{MaxVacPPN}
\begin{cases}
curl_{\vec x} \vec H\equiv \frac{4\pi}{c}\vec j+\frac{1}{c}\frac{\partial \vec D}{\partial t}
,\\
div_{\vec x} \vec D\equiv 4\pi\rho
,\\
curl_{\vec x} \vec E+\frac{1}{c}\frac{\partial \vec B}{\partial t}\equiv 0
,\\
div_{\vec x} \vec B\equiv 0
.
\end{cases}
\end{equation}
Here $\vec E$ is the electric field, $\vec B$ is the magnetic field,
$\vec D$ is the electric displacement field, $\vec H$ is the $\vec
H$-magnetic field, $\rho$ is the charge density, $\vec j$ is the
current density and $c$ is the universal constant,
called
speed of light.
%
%
%
It is assumed in the Classical Electrodynamics that for the vacuum
we always have $\vec D\equiv \vec E$ and $\vec H\equiv \vec B$.

 We assume that the Maxwell equations in the vacuum
have the usual form \er{MaxVacPPN}, as in any other medium,
however, similarly to the General Theory of Relativity we assume
that the electromagnetic field is influenced by the gravitational
field.
%
%
%
%
%
%
Then, we assume that for a given inertial coordinate system we have
$\vec D(\vec x,t)= \vec E(\vec x,t)$ and $\vec H(\vec x,t)=\vec
B(\vec x,t)$ for the vacuum only in the case where the vectorial
gravitational potential $\vec v(\vec x,t)$ on the point $\vec x$ at
the time $t$ equals to zero in the given coordinate system i.e.
\begin{equation}\label{IdspEthPPN}
\text{If}\;\;\vec v(\vec x,t)=0\;\;\text{for some}\;\;(\vec
x,t)\in\R^3\times\R\;\;\text{then}\;\;\vec D(\vec x,t)= \vec E(\vec
x,t)\;\;\text{and}\;\;\vec H(\vec x,t)=\vec B(\vec x,t),
\end{equation}
where $\vec v(\vec x,t)$ is the same as in
\er{noninchgravortbstrjgghguittu2gjgghhjhghjhjgghgghghghtytythvfghfgghjggghjgjh}.
In order to obtain the relations $\vec D\sim \vec E$ and $\vec H\sim
\vec B$ in the general case we assume that the equations
\er{MaxVacPPN} and the Lorentz force $\vec F:=\sigma \vec
E+\frac{\sigma}{c}\,\vec u\times \vec B$ (where $\sigma$ is the
charge of the test particle and $\vec u$ is its velocity) are
invariant under the Galilean Transformations:
\begin{equation}\label{Gal2PPN}
\begin{cases}
\vec x'=\vec x+t\vec w,\\
t'=t.
\end{cases}
\end{equation}
First observe that if
$\vec u$ is a velocity of the test particle then $\vec u'=\vec
u+\vec w$. Thus, since we assumed that the Lorentz force $\vec
F:=\sigma \vec E+\frac{\sigma}{c}\vec u\times \vec B$ is invariant
under Galilean transformation we infer $$\sigma \vec
E'+\frac{\sigma}{c}\, (\vec u+\vec w)\times \vec B'=\sigma \vec
E'+\frac{\sigma}{c}\, \vec u'\times \vec B'=\vec F'=\vec F=\sigma
\vec E+\frac{\sigma}{c}\, \vec u\times \vec B.$$ Therefore, we
obtain the following identities:
\begin{equation}\label{EBTransPPN}
\begin{cases}
\vec E'=\vec E-\frac{1}{c}\,\vec w\times \vec B,\\
\vec B'=\vec B.
\end{cases}
\end{equation}
It is easy to check that, under transformations \er{Gal2PPN} and
\er{EBTransPPN}, the last two equations in \er{MaxVacPPN} are
invariant. Next observe that in the absents of currents and charges
the first two equations in \er{MaxVacPPN} for $\vec H$ and $\vec D$
will be the same as the last two for $\vec E$ and $\vec B$ if we
will change the sign of the time there. Therefore, it can be assumed
that the first two equations will stay invariant under the
transformation:
\begin{equation}\label{HDTransPPN}
\begin{cases}
\vec H'=\vec H+\frac{1}{c}\,\vec w\times \vec D,\\
\vec D'=\vec D.
\end{cases}
\end{equation}
Indeed, since $\rho'=\rho$ and $\vec j'=\vec j+\rho \vec w$, it can
be easily checked that under the transformations \er{Gal2PPN} and
\er{HDTransPPN} the first two equations will stay invariant also in
the case of charges and currents. Therefore, we obtained that all
equations in \er{MaxVacPPN} are invariant under the transformations
\er{Gal2PPN}
%
%
%
%
%
%
and
\begin{equation}\label{EBDHTrans2PPN}
\begin{cases}
\vec D'=\vec D,\\
\vec B'=\vec B,\\
\vec E'=\vec E-\frac{1}{c}\,\vec w\times \vec B,\\
\vec H'=\vec H+\frac{1}{c}\,\vec w\times \vec D.
\end{cases}
\end{equation}
Next fix some point $(\vec x_1,t_1)\in\R^3\times\R$ and consider
$\vec w:=-\vec v(\vec x_1,t_1)$, where $\vec v$ is the vectorial
gravitational potential. Then, since $\vec v'=\vec v+\vec w$
(speed-like vector field), we obtain that at the point $(\vec
x'_1,t'_1)$ we have $\vec v'=0$. Therefore, by the assumption
\er{IdspEthPPN} we must have $\vec E'=\vec D'$ and $\vec H'=\vec B'$
at this point. Plugging it into \er{EBDHTrans2PPN}, for this point
we obtain
\begin{multline}
\label{EBDHTranssled1PPN} \vec E(\vec x_1,t_1)+\frac{\vec v(\vec
x_1,t_1)}{c}\times \vec B(\vec x_1,t_1)=\vec E(\vec
x_1,t_1)-\frac{\vec w}{c}\times \vec B(\vec x_1,t_1)\\=\vec E'(\vec
x'_1,t'_1)=\vec D'(\vec x'_1,t'_1)=\vec D(\vec x_1,t_1),
\end{multline}
and
\begin{multline} \label{EBDHTranssled2PPN}
\vec H(\vec x_1,t_1)-\frac{\vec v(\vec x_1,t_1)}{c}\times \vec
D(\vec x_1,t_1)=\vec H(\vec x_1,t_1)+\frac{\vec w}{c}\times \vec
D(\vec x_1,t_1)\\=\vec H'(\vec x'_1,t'_1)=\vec B'(\vec
x'_1,t'_1)=\vec B(\vec x_1,t_1).
\end{multline}
Thus, since the point $(\vec x_1,t_1)\in\R^3\times\R$ was
arbitrarily chosen, by \er{EBDHTranssled1PPN} and
\er{EBDHTranssled2PPN} we obtain the following relations
\begin{equation}\label{DtoEBtoHPPN}
\begin{cases}
\vec E(\vec x,t)=\vec D(\vec x,t)-\frac{1}{c}\,\vec v(\vec
x,t)\times
\vec B(\vec x,t)\quad\quad\forall(\vec x,t)\in\R^3\times\R\\
\vec H(\vec x,t)=\vec B(\vec x,t)+\frac{1}{c}\,\vec v(\vec
x,t)\times \vec D(\vec x,t)\quad\quad\forall(\vec
x,t)\in\R^3\times\R.
\end{cases}
\end{equation}
Plugging \er{DtoEBtoHPPN} into \er{MaxVacPPN} we obtain the full
system of Electrodynamics in the case of an arbitrarily vectorial
gravitational potential:
\begin{equation}\label{MaxVacFullPPN}
\begin{cases}
curl_{\vec x} \vec H\equiv \frac{4\pi}{c}\vec j+\frac{1}{c}\frac{\partial \vec D}{\partial t}
,\\
div_{\vec x} \vec D\equiv 4\pi\rho
,\\
curl_{\vec x} \vec E+\frac{1}{c}\frac{\partial \vec B}{\partial t}\equiv 0
,\\
div_{\vec x} \vec B\equiv
0
,\\
\vec E=\vec D-\frac{1}{c}\,\vec v\times
\vec B
,\\
\vec H=\vec B+\frac{1}{c}\,\vec v\times \vec
D
,
\end{cases}
\end{equation}
where $\vec v$ is the vectorial gravitational potential. It can be
easily checked that system \er{MaxVacFullPPN} and the Lorentz force
$\vec F:=\sigma(\vec E+\frac{\vec u}{c}\times \vec B)$ are invariant
under transformations \er{Gal2PPN} and \er{EBDHTrans2PPN}.
%
%
%
%
%
%
%
%
Note here that $\vec D$ and $\vec B$ are invariant under the change
of inertial coordinate system. Moreover, we can write the Lorentz
force as $\vec F:=\sigma(\vec D+\frac{\vec u-\vec v}{c}\times \vec
B)$, where $(\vec u-\vec v)$ is the relative velocity of the test
particle with respect to the fictitious aether.


\section{Maxwell equations in non-inertial cartesian coordinate
systems}\label{INNONredPPN}
Consider the change of certain non-inertial cartesian coordinate
system $(*)$ to another cartesian coordinate system $(**)$:
\begin{equation}\label{noninchredPPN}
\begin{cases}
\vec x'=A(t)\cdot \vec x+\vec z(t),\\
t'=t,
\end{cases}
\end{equation}
where $A(t)\in SO(3)$ is a rotation i.e. $A(t)\in \R^{3\times 3}$,
$det\, A(t)>0$ and $A(t)\cdot A^T(t)=I$ (here $A^T$ is the transpose
matrix of $A$ and $I$ is the identity matrix). Next, assume that in
coordinate system $(**)$ we observe a validity of Maxwell Equations
for the vacuum in the form:
\begin{equation}\label{MaxVacFull1ninshtrredPPN}
\begin{cases}
curl_{\vec x'} \vec H'\equiv \frac{4\pi}{c}\vec
j'+\frac{1}{c}\frac{\partial
\vec D'}{\partial t'},\\
div_{\vec x'} \vec D'\equiv 4\pi\rho',\\
curl_{\vec x'} \vec E'+\frac{1}{c}\frac{\partial \vec B'}{\partial t'}\equiv 0,\\
div_{\vec x'} \vec B'\equiv 0,\\
\vec E'=\vec D'-\frac{1}{c}\,\vec v'\times \vec B',\\
\vec H'=\vec B'+\frac{1}{c}\,\vec v'\times \vec D'.
\end{cases}
\end{equation}
Moreover, we assume that in coordinate system $(**)$ we observe a
validity of expression for the Lorentz force
\begin{equation}\label{LorenzChredPPN}
\vec F':=\sigma' \vec E'+\frac{\sigma'}{c}\,\vec u'\times \vec B'
\end{equation}
(where $\sigma'$ is the charge of the test particle and $\vec u'$ is
its velocity in coordinate system $(**)$). All above happens, in
particular, if coordinate system $(**)$ is inertial. Observe that if
$\vec F$ is the force in coordinate system $(*)$ which corresponds
to the Lorentz force $\vec F'$ in coordinate system $(**)$, then we
must have $\vec F'=A(t)\cdot\vec F$. Moreover, denoting $\vec
w(t)=\vec z'(t)$, we have the following obvious relations between
the physical characteristics in coordinate systems $(*)$ and $(**)$:
\begin{align}
\label{NoIn1redPPN}\vec F'=A(t)\cdot\vec F,\\
\label{NoIn2redPPN}\sigma'=\sigma,\\
\label{NoIn3redPPN}\vec u'=A(t)\cdot \vec u+A'(t)\cdot\vec x+\vec w(t),\\
\label{NoIn4redPPN}\rho'=\rho,\\
\label{NoIn5redPPN}\vec v'=A(t)\cdot \vec v+A'(t)\cdot\vec x+\vec w(t),\\
\label{NoIn6redPPN}\vec j'=A(t)\cdot \vec j+\rho A'(t)\cdot\vec
x+\rho\vec w(t)
\end{align}
(where $A'(t)$ is a derivative of $A(t)$). We consider the fields
$\vec E$ and $\vec B$ in the coordinate system $(*)$ to be defined
by the expression of Lorentz force:
\begin{equation}\label{LorenzChllredPPN}
\vec F=\sigma \vec E+\frac{\sigma}{c}\,\vec u\times \vec B.
\end{equation}
Plugging it into \er{LorenzChredPPN} and using \er{NoIn1redPPN},
\er{NoIn2redPPN} and \er{NoIn3redPPN} we deduce
\begin{multline}\label{buiguyfttyjredPPN}
\sigma \left(\vec E'+\frac{1}{c}\,\left(A'(t)\cdot\vec x+\vec
w(t)\right)\times \vec B'\right)+\frac{\sigma}{c}\,\left(A(t)\cdot
\vec u\right)\times \vec B'\\=\sigma \vec
E'+\frac{\sigma}{c}\,\left(A(t)\cdot \vec u+A'(t)\cdot\vec x+\vec
w(t)\right)\times \vec B'\\=\sigma' \vec E'+\frac{\sigma'}{c}\,\vec
u'\times \vec B'=\vec F'=A(t)\cdot\vec F=\sigma A(t)\cdot\vec
E+\frac{\sigma}{c}\,A(t)\cdot\left(\vec u\times \vec B\right).
\end{multline}
Thus using the trivial identity
\begin{equation}\label{cufuyggjjjfredPPN}
A\cdot\left(\vec a\times\vec b\right)=\left(A\cdot\vec
a\right)\times\left(A\cdot\vec b\right)\quad\forall\, \vec
a\in\R^3,\;\;\forall\, \vec b\in\R^3,\;\;\forall\, A\in SO(3),
\end{equation}
by \er{buiguyfttyjredPPN} we deduce
\begin{multline}\label{buiguyfttyjuiuihjredPPN}
\sigma \left(\vec E'+\frac{1}{c}\,\left(A'(t)\cdot\vec x+\vec
w(t)\right)\times \vec B'\right)+\frac{\sigma}{c}\,\left(A(t)\cdot
\vec u\right)\times \vec B'\\=\sigma A(t)\cdot\vec
E+\frac{\sigma}{c}\,\left(A(t)\cdot\vec u\right)\times
\left(A(t)\cdot\vec B\right).
\end{multline}
Therefore, since \er{buiguyfttyjuiuihjredPPN} must be valid for
arbitrary choices of $\vec u$ we deduce
\begin{equation*}
\begin{cases}
\vec B'=A(t)\cdot\vec B,
\\
\vec E'+\frac{1}{c}\,\left(A'(t)\cdot\vec x+\vec w(t)\right)\times
\vec B'=A(t)\cdot\vec E.
\end{cases}
\end{equation*}
Therefore,
\begin{equation*}
\vec E'=A(t)\cdot\vec E-\frac{1}{c}\,\left(A'(t)\cdot\vec x+\vec
w(t)\right)\times \vec B'=A(t)\cdot\vec
E-\frac{1}{c}\,\left(A'(t)\cdot\vec x+\vec w(t)\right)\times
\left(A(t)\cdot\vec B\right).
\end{equation*}
So we obtained the following relations linking the fields $\vec
E,\vec B$ in coordinate system $(*)$ and $\vec E',\vec B'$ in
coordinate system $(**)$:
\begin{equation}\label{yuythfgfyftydtydtydtyddredPPN}
\begin{cases}
\vec E'=A(t)\cdot\vec E-\frac{1}{c}\,\left(A'(t)\cdot\vec x+\vec
w(t)\right)\times \left(A(t)\cdot\vec B\right),
\\
\vec B'=A(t)\cdot\vec B.
\end{cases}
\end{equation}
Next, by \er{MaxVacFull1ninshtrredPPN} in coordinate system $(**)$
we have the relations
\begin{equation*}
\begin{cases}
\vec D'=\vec E'+\frac{1}{c}\,\vec v'\times \vec B',\\
\vec H'=\vec B'+\frac{1}{c}\,\vec v'\times \vec D'.
\end{cases}
\end{equation*}
Analogously we define $\vec D$ and $\vec H$  in coordinate system
$(*)$ by the formulas:
\begin{equation}\label{gjhghfhgdghdredPPN}
\begin{cases}
\vec D=\vec E+\frac{1}{c}\,\vec v\times \vec B,\\
\vec H=\vec B+\frac{1}{c}\,\vec v\times \vec D.
\end{cases}
\end{equation}
Then with the help of \er{yuythfgfyftydtydtydtyddredPPN},
\er{NoIn5redPPN} and \er{cufuyggjjjfredPPN} we deduce:
\begin{multline*}
\vec D'=\vec E'+\frac{1}{c}\,\vec v'\times \vec B'=\\A(t)\cdot\vec
E-\frac{1}{c}\,\left(A'(t)\cdot\vec x+\vec w(t)\right)\times
\left(A(t)\cdot\vec B\right)+\frac{1}{c}\,\vec v'\times
\left(A(t)\cdot\vec B\right)=\\A(t)\cdot\vec
E-\frac{1}{c}\,\left(A'(t)\cdot\vec x+\vec w(t)\right)\times
\left(A(t)\cdot\vec B\right)+\frac{1}{c}\,\left(A(t)\cdot \vec
v+A'(t)\cdot\vec x+\vec w(t)\right)\times \left(A(t)\cdot\vec
B\right)\\=A(t)\cdot\vec E+\frac{1}{c}\,\left(A(t)\cdot \vec
v\right)\times \left(A(t)\cdot\vec B\right)=A(t)\cdot\left(\vec
E+\frac{1}{c}\,\vec v\times \vec B\right)=A(t)\cdot \vec D,
\end{multline*}
and thus
\begin{multline*}
\vec H'=\vec B'+\frac{1}{c}\,\vec v'\times \vec D'=A(t)\cdot\vec
B+\frac{1}{c}\,\left(A(t)\cdot \vec v+A'(t)\cdot\vec x+\vec
w(t)\right)\times \left(A(t)\cdot\vec D\right)=\\
A(t)\cdot\vec B+\frac{1}{c}\,\left(A(t)\cdot \vec v\right)\times
\left(A(t)\cdot\vec D\right)+\frac{1}{c}\,\left(A'(t)\cdot\vec
x+\vec
w(t)\right)\times \left(A(t)\cdot\vec D\right)=\\
A(t)\cdot\left(\vec B+\frac{1}{c}\,\vec v\times \vec
D\right)+\frac{1}{c}\,\left(A'(t)\cdot\vec x+\vec w(t)\right)\times
\left(A(t)\cdot\vec D\right)\\=A(t)\cdot\vec
H+\frac{1}{c}\,\left(A'(t)\cdot\vec x+\vec w(t)\right)\times
\left(A(t)\cdot\vec D\right).
\end{multline*}
I.e. the following relations are valid:
\begin{equation}\label{yuythfgfyftydtydtydtyddyyyhhddhhhredPPN}
\begin{cases}
\vec D'=A(t)\cdot \vec D,\\
\vec B'=A(t)\cdot\vec B,\\
\vec E'=A(t)\cdot\vec E-\frac{1}{c}\,\left(A'(t)\cdot\vec x+\vec
w(t)\right)\times \left(A(t)\cdot\vec B\right),\\
\vec H'=A(t)\cdot\vec H+\frac{1}{c}\,\left(A'(t)\cdot\vec x+\vec
w(t)\right)\times \left(A(t)\cdot\vec D\right).
\end{cases}
\end{equation}
In particular vector fields $\vec D$ and $\vec B$ are proper vector
fields.

Next, by \er{noninchredPPN} and by Proposition \ref{yghgjtgyrtrt},
for every vector field $\vec \Gamma:\R^3\times[t_0,+\infty)\to\R^3$
we have
\begin{equation}
\label{vyguiuiujggghjjgredPPN}
\begin{cases}
d_{\vec x'}\vec \Gamma=\left(d_{\vec x}\vec \Gamma\right)\cdot A^{-1}(t)\\
curl_{\vec x'}\left( A(t)\cdot\vec \Gamma\right)=A(t)\cdot
curl_{\vec x}
\vec \Gamma\\
div_{\vec x'}\left( A(t)\cdot\vec \Gamma\right)=div_{\vec x}\vec
\Gamma.
\end{cases}
\end{equation}
Furthermore, by Proposition \ref{yghgjtgyrtrt},
for every vector field $\vec \Gamma:\R^3\times[t_0,+\infty)\to\R^3$
we have
\begin{multline}\label{bgvfgfhhtgjtggiuguiuiredPPN}
\frac{\partial \left(A(t)\cdot\vec \Gamma\right)}{\partial
t'}-curl_{\vec x'}\left(\vec v'\times\left(A(t)\cdot\vec
\Gamma\right)\right)+\left({div}_{\vec x'}\left(A(t)\cdot\vec
\Gamma\right)\right)\vec v'\\=A(t)\cdot\left(\frac{\partial \vec
\Gamma}{\partial t}- curl_{\vec x}\left(\vec v\times
\vec\Gamma\right)+\left({div}_{\vec x}\vec \Gamma\right)\vec
v\right).
\end{multline}
On the other hand, by \er{MaxVacFull1ninshtrredPPN} we have
\begin{multline}\label{uiguihjkjkklklklredPPN}
curl_{\vec x'} \vec B'- \frac{4\pi}{c}\left(\vec j'-\rho'\vec
v'\right)-\frac{1}{c}\left(\frac{\partial \vec D'}{\partial
t'}-curl_{\vec x'}\left(\vec v'\times\vec
D'\right)+\left({div}_{\vec x'}\vec D'\right)\vec
v'\right)\\=curl_{\vec x'} \vec H'- \frac{4\pi}{c}\vec
j'-\frac{1}{c}\frac{\partial \vec D'}{\partial t'}=0\end{multline}
and
\begin{equation}\label{uiguihjkjkklklkljhhredPPN}
curl_{\vec x'} \vec D'+\frac{1}{c}\left(\frac{\partial \vec
B'}{\partial t'}-curl_{\vec x'}\left(\vec v'\times\vec
B'\right)+\left({div}_{\vec x'}\vec B'\right)\vec
v'\right)=curl_{\vec x'} \vec E'+\frac{1}{c}\frac{\partial \vec
B'}{\partial t'}=0.
\end{equation}
Thus plugging \er{uiguihjkjkklklklredPPN} and
\er{uiguihjkjkklklkljhhredPPN} into \er{bgvfgfhhtgjtggiuguiuiredPPN}
and using \er{gjhghfhgdghdredPPN}, \er{NoIn4redPPN},
\er{NoIn5redPPN}, \er{NoIn6redPPN} and \er{vyguiuiujggghjjgredPPN}
gives
\begin{multline}\label{uiguihjkjkklklklhkkredPPN}
A(t)\cdot\left(curl_{\vec x} \vec H-\frac{4\pi}{c}\,\vec
j-\frac{1}{c}\frac{\partial \vec D}{\partial
t}+\frac{1}{c}\left(4\pi\rho-{div}_{\vec x}\vec
D\right)\vec v\right)=\\
A(t)\cdot\left(curl_{\vec x} \vec B- \frac{4\pi}{c}\left(\vec
j-\rho\vec v\right)-\frac{1}{c}\left(\frac{\partial \vec D}{\partial
t}-curl_{\vec x}\left(\vec v\times\vec D\right)+\left({div}_{\vec
x}\vec D\right)\vec v\right)\right)=\\curl_{\vec x'} \vec B'-
\frac{4\pi}{c}\left(\vec j'-\rho'\vec
v'\right)-\frac{1}{c}\left(\frac{\partial \vec D'}{\partial
t'}-curl_{\vec x'}\left(\vec v'\times\vec
D'\right)+\left({div}_{\vec x'}\vec D\right)\vec v'\right)=0.
\end{multline}
Similarly
\begin{multline}\label{uiguihjkjkklklkljhhjkggkjredPPN}
A(t)\cdot\left(curl_{\vec x} \vec E+\frac{1}{c}\frac{\partial \vec
B}{\partial t}+\frac{1}{c}\left({div}_{\vec x}\vec B\right)\vec
v\right)=\\A(t)\cdot\left(curl_{\vec x} \vec
D+\frac{1}{c}\left(\frac{\partial \vec B}{\partial t}-curl_{\vec
x}\left(\vec v\times\vec B\right)+\left({div}_{\vec x}\vec
B\right)\vec v\right)\right)\\= curl_{\vec x'} \vec
D'+\frac{1}{c}\left(\frac{\partial \vec B'}{\partial t'}-curl_{\vec
x'}\left(\vec v'\times\vec B'\right)+\left({div}_{\vec x'}\vec
B\right)\vec v'\right)=0.
\end{multline}
On the other hand, by \er{yuythfgfyftydtydtydtyddyyyhhddhhhredPPN},
\er{MaxVacFull1ninshtrredPPN}, \er{vyguiuiujggghjjgredPPN} and
\er{NoIn4redPPN} we obtain:
\begin{equation}
\label{vyguiuiujggkkkhjhjjppopredPPN} 4\pi\rho=4\pi\rho'=div_{\vec
x'} \vec D'=div_{\vec x} \vec D\quad\text{and}\quad 0=div_{\vec x'}
\vec B'=div_{\vec x} \vec B.
\end{equation}
Thus plugging \er{uiguihjkjkklklklhkkredPPN},
\er{uiguihjkjkklklkljhhjkggkjredPPN} and
\er{vyguiuiujggkkkhjhjjppopredPPN} we obtain
\begin{equation}\label{fhgjhkhkredPPN}
\begin{cases}
curl_{\vec x} \vec H= \frac{4\pi}{c}\vec
j+\frac{1}{c}\frac{\partial \vec D}{\partial t},\\
div_{\vec x} \vec D=4\pi\rho,
\\
curl_{\vec x} \vec E+\frac{1}{c}\frac{\partial \vec B}{\partial
t}=0,
\\
div_{\vec x} \vec B=0.
\end{cases}
\end{equation}
Then, plugging \er{fhgjhkhkredPPN} into
\er{gjhghfhgdghdredPPN}, we finally obtain that in coordinate system
$(*)$ the Maxwell equations have the same form as in system $(**)$
i.e.
\begin{equation}\label{MaxVacFull1ninshtrhjkkredPPN}
\begin{cases}
curl_{\vec x} \vec H\equiv \frac{4\pi}{c}\vec
j+\frac{1}{c}\frac{\partial
\vec D}{\partial t},\\
div_{\vec x} \vec D\equiv 4\pi\rho,\\
curl_{\vec x} \vec E+\frac{1}{c}\frac{\partial \vec B}{\partial t}\equiv 0,\\
div_{\vec x} \vec B\equiv 0,\\
\vec E=\vec D-\frac{1}{c}\,\vec v\times \vec B,\\
\vec H=\vec B+\frac{1}{c}\,\vec v\times \vec D.
\end{cases}
\end{equation}
Therefore, since the assumption, that coordinate system $(**)$ is
inertial, implies the relations of \er{MaxVacFull1ninshtrredPPN}, we
deduce that the expressions of Maxwell equations in the form
\er{MaxVacFull1ninshtrhjkkredPPN} and of the Lorentz force in the
form \er{LorenzChllredPPN}
%
%
%
%
%
%
are valid in every non-inertial cartesian coordinate system.
Moreover, under the change of the
system, given by \er{noninchredPPN}, the transformations of the
electromagnetic fields are given by
\er{yuythfgfyftydtydtydtyddyyyhhddhhhredPPN} i.e.
\begin{equation}\label{yuythfgfyftydtydtydtyddyyyhhddhhhredPPN111hgghjg}
\begin{cases}
\vec D'=A(t)\cdot \vec D,\\
\vec B'=A(t)\cdot\vec B,\\
\vec E'=A(t)\cdot\vec E-\frac{1}{c}\,\left(\frac{dA}{dt}(t)\cdot\vec
x+\frac{d\vec z}{dt}(t)\right)\times \left(A(t)\cdot\vec B\right),\\
\vec H'=A(t)\cdot\vec H+\frac{1}{c}\,\left(\frac{dA}{dt}(t)\cdot\vec
x+\frac{d\vec z}{dt}(t)\right)\times \left(A(t)\cdot\vec D\right).
\end{cases}
\end{equation}
%
%
%
%
%
%
%
%

So the laws of Electrodynamics are also invariant in non-inertial
coordinate systems.

\begin{remark}\label{gyugfjyhgghg}
Since as already mentioned before, the direction of the local
vectorial gravitational potential is normal to the Earth surface, in
the frames of our model, we provide a non-relativistic explanation
of the classical Michelson-Morley experiment. Indeed in this
experiment the axes of the apparatus are tangent to the Earth
surface and thus the null result cannot be affected by the vectorial
gravitational potential. Since, the value of the local vectorial
gravitational potential equals to the escape velocity, if we
consider the vertical Michelson-Morley experiment, where one of the
axes of the apparatus is normal to the Earth surface, then in the
frames of our model the expected result should be analogous to the
positive result of Aether drift with the speed equal to the escape
velocity. However, regarding the vertical Michelson-Morley
experiment i.e. the modification of Michelson-Morley experiment,
where at least one of the axes of the apparatus is not tangent to
the Earth surface, we found only very scarce and contradictory
information.
\end{remark}

\section{Scalar and vectorial electromagnetic potentials}\label{ghfvdgfdjfg}
Consider
the system of Maxwell equations in the vacuum of the form
\begin{equation}\label{MaxVacFull1bjkgjhjhgjaaaPPN}
\begin{cases}
curl_{\vec x} \vec H\equiv \frac{4\pi}{c}\vec
j+\frac{1}{c}\frac{\partial
\vec D}{\partial t},\\
div_{\vec x} \vec D\equiv 4\pi\rho,\\
curl_{\vec x} \vec E+\frac{1}{c}\frac{\partial \vec B}{\partial t}\equiv 0,\\
div_{\vec x} \vec B\equiv 0,\\
\vec E=\vec D-\frac{1}{c}\,\vec v\times \vec B,\\
\vec H=\vec B+\frac{1}{c}\,\vec v\times \vec D,
\end{cases}
\end{equation}
where $\vec v$ is the vectorial gravitational potential. Then by the
third and the fourth equations in \er{MaxVacFull1bjkgjhjhgjaaaPPN}
we can write:
\begin{equation}\label{MaxVacFull1bjkgjhjhgjgjgkjfhjfdghghligioiuittrPPN}
\begin{cases}
\vec B\equiv curl_{\vec x} \vec A,\\
\vec E\equiv-\nabla_{\vec x}\Psi-\frac{1}{c}\frac{\partial\vec
A}{\partial t},
\end{cases}
\end{equation}
where we call $\Psi$ and $\vec A$ the scalar and the vectorial
electromagnetic potentials. Then by
\er{MaxVacFull1bjkgjhjhgjgjgkjfhjfdghghligioiuittrPPN} and
\er{MaxVacFull1bjkgjhjhgjaaaPPN} we have
\begin{equation}\label{vhfffngghPPN333yuyu}
\begin{cases}
\vec B= curl_{\vec x} \vec A\\
\vec E=-\nabla_{\vec x}\Psi-\frac{1}{c}\frac{\partial\vec
A}{\partial t}\\
 \vec D=-\nabla_{\vec
x}\Psi-\frac{1}{c}\frac{\partial\vec A}{\partial t}+\frac{1}{c}\vec
v\times curl_{\vec x}\vec A\\
\vec H\equiv curl_{\vec x} \vec A+\frac{1}{c}\,\vec
v\times\left(-\nabla_{\vec x}\Psi-\frac{1}{c}\frac{\partial\vec
A}{\partial t}+\frac{1}{c}\vec v\times curl_{\vec x}\vec A\right).
\end{cases}
\end{equation}
We also define the proper scalar electromagnetic potential
$\Psi_0=\Psi_0(\vec x,t)$ by
\begin{equation}\label{vhfffngghhjghhgPPNghghghutghffugghjhjkjjkl}
\Psi_0:=\Psi-\frac{1}{c}\vec A\cdot\vec v.
\end{equation}
The name "proper" will be clarified bellow. Then, by
\er{vhfffngghPPN333yuyu} and
\er{vhfffngghhjghhgPPNghghghutghffugghjhjkjjkl} we have
\begin{equation}\label{vhfffngghPPN}
\begin{cases}
\vec B= curl_{\vec x} \vec A\\
\vec E=-\nabla_{\vec x}\Psi_0-\frac{1}{c}\frac{\partial\vec
A}{\partial t}-\frac{1}{c}\nabla_{\vec x}\left(\vec A\cdot\vec
v\right)
\\
 \vec D=-\nabla_{\vec
x}\Psi_0-\frac{1}{c}\left(\frac{\partial\vec A}{\partial t}-\vec
v\times curl_{\vec x}\vec A+\nabla_{\vec x}\left(\vec A\cdot\vec
v\right)\right)
\\
\vec H\equiv curl_{\vec x} \vec A-\frac{1}{c}\,\vec v\times
\left(-\nabla_{\vec x}\Psi_0+\frac{1}{c}\left(\frac{\partial\vec
A}{\partial t}-\vec v\times curl_{\vec x}\vec A+\nabla_{\vec
x}\left(\vec A\cdot\vec v\right)\right)\right).
\end{cases}
\end{equation}
The electromagnetic potentials are not uniquely defined and thus we
need to choose a calibration. For definiteness we can take $\vec A$
to satisfy
\begin{equation}\label{MaxVacFull1bjkgjhjhgjgjgkjfhjfdghghligioiuittrPPN22}
div_{\vec x}\vec A\equiv 0.
\end{equation}
%
%
%
%
%
%
It is clear that if $(\tilde\Psi,\tilde\Psi_0,\tilde{\vec A})$ is
another choice of electromagnetic potentials with a different
calibration then there exists a scalar field $w:=w(\vec x,t)$ such
that we have
\begin{equation}\label{MaxVacFull1bjkgjhjhgjgjgkjfhjfdghghligioiuittrPPNhjkjhkjgghhjjhj}
\begin{cases}
\tilde\Psi=\Psi+\frac{1}{c}\frac{\partial w}{\partial t}\\
\tilde{\vec A}=\vec A-\nabla_{\vec x}w\\
\tilde\Psi_0=\Psi_0+\frac{1}{c}\left(\frac{\partial w}{\partial
t}+\vec v\cdot\nabla_{\vec x}w\right).
\end{cases}
\end{equation}

Next consider the change of certain non-inertial cartesian
coordinate system $(*)$ to another cartesian coordinate system
$(**)$:
\begin{equation}\label{noninchredPPN111}
\begin{cases}
\vec x'=A(t)\cdot \vec x+\vec z(t),\\
t'=t,
\end{cases}
\end{equation}
where $A(t)\in SO(3)$ is a rotation i.e. $A(t)\in \R^{3\times 3}$,
$det\, A(t)>0$ and $A(t)\cdot A^T(t)=I$ (here $A^T$ is the transpose
matrix of $A$ and $I$ is the identity matrix). We are going to
investigate, what are the transformations of $(\Psi,\Psi_0,\vec
A)\sim(\Psi',\Psi'_0,\vec A')$ under the change of coordinates,
given by \er{noninchredPPN111}. Since, by
\er{yuythfgfyftydtydtydtyddyyyhhddhhhredPPN} the following relations
are valid
\begin{equation}\label{yuythfgfyftydtydtydtyddyyyhhddhhhredPPN111}
\begin{cases}
\vec D'=A(t)\cdot \vec D,\\
\vec B'=A(t)\cdot\vec B,\\
\vec E'=A(t)\cdot\vec E-\frac{1}{c}\,\left(\frac{dA}{dt}(t)\cdot\vec
x+\frac{d\vec z}{dt}(t)\right)\times \left(A(t)\cdot\vec B\right),\\
\vec H'=A(t)\cdot\vec H+\frac{1}{c}\,\left(\frac{dA}{dt}(t)\cdot\vec
x+\frac{d\vec z}{dt}(t)\right)\times \left(A(t)\cdot\vec D\right),
\end{cases}
\end{equation}
by the second equality in
\er{yuythfgfyftydtydtydtyddyyyhhddhhhredPPN111}, the first equality
in \er{MaxVacFull1bjkgjhjhgjgjgkjfhjfdghghligioiuittrPPN} and
\er{MaxVacFull1bjkgjhjhgjgjgkjfhjfdghghligioiuittrPPN22} we deduce
\begin{equation}\label{yuythfgfyftydtydtydtyddyyyhhddhhhredPPN111222}
\vec A'=A(t)\cdot \vec A,
\end{equation}
i.e. if $\vec A$ satisfies calibration
\er{MaxVacFull1bjkgjhjhgjgjgkjfhjfdghghligioiuittrPPN22} then it is
a proper vector field. On the other hand, by \er{vhfffngghPPN} we
have
\begin{equation}\label{vhfffngghPPNjuiuio}
\nabla_{\vec x}\Psi_0= -\vec D-\frac{1}{c}\left(\frac{\partial\vec
A}{\partial t}-\vec v\times curl_{\vec x}\vec A+\nabla_{\vec
x}\left(\vec A\cdot\vec v\right)\right).
\end{equation}
Thus by \er{yuythfgfyftydtydtydtyddyyyhhddhhhredPPN111222} and
\er{yuythfgfyftydtydtydtyddyyyhhddhhhredPPN111}, using Proposition
\ref{yghgjtgyrtrt}
we deduce that $\nabla_{\vec
x}\Psi_0$ is a proper vector field, i.e.
\begin{equation}\label{vhfffngghhjghhgPPNghghghuigigg}
\nabla_{\vec x'}\Psi'_0=A(t)\cdot\nabla_{\vec x}\Psi_0.
\end{equation}
So
\begin{equation}\label{vhfffngghhjghhgPPNghghghutghffohjh}
\Psi'_0=\Psi_0,
\end{equation}
i.e. $\Psi_0$ is a is proper scalar field, invariant under the
change of non-inertial cartesian coordinate systems. This explains
why we called $\Psi_0$ the proper scalar electromagnetic potential.
Then by \er{vhfffngghhjghhgPPNghghghutghffohjh} and
\er{vhfffngghhjghhgPPNghghghutghffugghjhjkjjkl} we deduce
\begin{equation}\label{vhfffngghhjghhgPPNghghghutghffghhg}
\left(\frac{1}{c}\vec A'\cdot\vec
v'-\Psi'\right)=\left(\frac{1}{c}\vec A\cdot\vec v-\Psi\right).
\end{equation}
Therefore, by \er{vhfffngghhjghhgPPNghghghutghffghhg},
\er{yuythfgfyftydtydtydtyddyyyhhddhhhredPPN111222}  and the fact
that
\begin{equation}
\label{NoIn5redPPNghjg}\vec v'=A(t)\cdot \vec
v+\frac{dA}{dt}(t)\cdot\vec x+\frac{d\vec z}{dt}(t),
\end{equation}
we deduce
\begin{equation}\label{vhfffngghhjghhgPPNghghghutghfflklhjkj}
\frac{1}{c}\vec A\cdot\left(\vec
v+A^T(t)\cdot\frac{dA}{dt}(t)\cdot\vec x+A^T(t)\cdot\frac{d\vec
z}{dt}(t)\right)-\Psi'=\frac{1}{c}\vec A\cdot\vec v-\Psi.
\end{equation}
So
\begin{equation}\label{vhfffngghhjghhgPPNghghghutghfflklhjkjhjg}
\Psi'=\Psi+\frac{1}{c}\vec
A\cdot\left(A^T(t)\cdot\frac{dA}{dt}(t)\cdot\vec
x+A^T(t)\cdot\frac{d\vec
z}{dt}(t)\right)=\Psi+\frac{1}{c}\left(A(t)\cdot\vec
A\right)\cdot\left(\frac{dA}{dt}(t)\cdot\vec x+\frac{d\vec
z}{dt}(t)\right).
\end{equation}
Therefore, under the change of some non-inertial cartesian
coordinate system $(*)$ to another cartesian coordinate system
$(**)$, given by \er{noninchredPPN111}, the electromagnetic
potentials transform as:
\begin{equation}\label{vhfffngghhjghhgPPNghghghutghfflklhjkjhjhjjgjkghhj}
\begin{cases}
\Psi'=
\Psi+\frac{1}{c}\left(\frac{dA}{dt}(t)\cdot\vec x+\frac{d\vec
z}{dt}(t)\right)\cdot\left(A(t)\cdot\vec A\right)
\\
\vec A'=A(t)\cdot \vec A\\
\Psi'_0:=\left(\Psi'-\frac{1}{c}\vec A'\cdot\vec
v'\right)=\Psi_0:=\left(\Psi-\frac{1}{c}\vec A\cdot\vec v\right).
\end{cases}
\end{equation}
In particular, under the Galilean transformations
\er{noninchgravortbstrjgghguittu1} the electromagnetic potentials
transform as:
%
%
%
%
%
%
\begin{equation}\label{vhfffngghhjghhgPPNghghghutghfflklhjkjhjhjjgjkghhjhhhjhgjgu}
\begin{cases}
\Psi'=
\Psi+\frac{1}{c}\vec w\cdot\vec A
\\
\vec A'=\vec A\\
\Psi'_0=\Psi_0.
\end{cases}
\end{equation}
In the proof of
\er{vhfffngghhjghhgPPNghghghutghfflklhjkjhjhjjgjkghhj} we used
equality \er{MaxVacFull1bjkgjhjhgjgjgkjfhjfdghghligioiuittrPPN22}
only for proof of equality
\er{yuythfgfyftydtydtydtyddyyyhhddhhhredPPN111222}. Thus relations
\er{vhfffngghhjghhgPPNghghghutghfflklhjkjhjhjjgjkghhj} are still
valid for every choice of calibration of $(\Psi,\Psi_0,\vec A)$,
which implies \er{yuythfgfyftydtydtydtyddyyyhhddhhhredPPN111222}. In
particular if $w$ is a proper scalar field i.e. $w'=w$ and if
$(\tilde\Psi,\tilde\Psi_0,\tilde{\vec A})$ is another choice of
electromagnetic potentials defined by
\begin{equation}\label{MaxVacFull1bjkgjhjhgjgjgkjfhjfdghghligioiuittrPPNhjkjhkjgghhjjhjkjggh}
\begin{cases}
\tilde\Psi=\Psi+\frac{1}{c}\frac{\partial w}{\partial t}\\
\tilde{\vec A}=\vec A-\nabla_{\vec x}w\\
\tilde\Psi_0=\Psi_0+\frac{1}{c}\left(\frac{\partial w}{\partial
t}+\vec v\cdot\nabla_{\vec x}w\right),
\end{cases}
\end{equation}
then, by Proposition \ref{yghgjtgyrtrt}
we have
\begin{equation}\label{vhfffngghhjghhgPPNghghghutghfflklhjkjhjhjjgjkghhjhjkhjg}
\begin{cases}
\tilde\Psi'= \tilde\Psi+\frac{1}{c}\left(\frac{dA}{dt}(t)\cdot\vec
x+\frac{d\vec z}{dt}(t)\right)\cdot\left(A(t)\cdot\tilde{\vec
A}\right)
\\
\tilde{\vec A}'=A(t)\cdot \tilde{\vec A}\\
\tilde\Psi'_0=\tilde\Psi_0.
\end{cases}
\end{equation}
On the other hand, we always can find a proper scalar field $w$ for
calibration to illuminate $\tilde\Psi_0$ in
\er{MaxVacFull1bjkgjhjhgjgjgkjfhjfdghghligioiuittrPPNhjkjhkjgghhjjhjkjggh}.
%
%
%
%
%
%
%
%
Then we have $\tilde\Psi_0\equiv 0$ and the electromagnetic fields
are fully represented by the vectorial electromagnetic potential
$\tilde{\vec A}$ analogously as the vectorial gravitational
potential represents the gravitational field. For this case, we
rewrite \er{vhfffngghPPN} as
\begin{equation}\label{vhfffngghPPNhjkhj}
\begin{cases}
\tilde\Psi_0=0\\
-\frac{1}{c}div_{\vec x}\left\{\frac{\partial\tilde{\vec
A}}{\partial t}-\vec v\times curl_{\vec x}\tilde{\vec
A}+\nabla_{\vec x}\left(\tilde{\vec A}\cdot\vec
v\right)\right\}=4\pi \rho
\\
\vec B= curl_{\vec x} \tilde{\vec A}\\
\vec E=-\frac{1}{c}\frac{\partial\tilde{\vec A}}{\partial
t}-\frac{1}{c}\nabla_{\vec x}\left(\tilde{\vec A}\cdot\vec v\right)
\\
 \vec D=-\frac{1}{c}\left(\frac{\partial\tilde{\vec A}}{\partial
t}-\vec v\times curl_{\vec x}\tilde{\vec A}+\nabla_{\vec
x}\left(\tilde{\vec A}\cdot\vec v\right)\right)
\\
\vec H\equiv curl_{\vec x} \tilde{\vec A}-\frac{1}{c}\,\vec v\times
\left(\frac{1}{c}\left(\frac{\partial\tilde{\vec A}}{\partial
t}-\vec v\times curl_{\vec x}\tilde{\vec A}+\nabla_{\vec
x}\left(\tilde{\vec A}\cdot\vec v\right)\right)\right).
\end{cases}
\end{equation}
Moreover, in this case
\er{vhfffngghhjghhgPPNghghghutghfflklhjkjhjhjjgjkghhjhjkhjg} is
satisfied.
%
%
%
%
%
%
%
%
%

\section{Lagrangian of the Electromagnetic field}\label{bhjghjfghfg} We would like to
present a Lagrangian and a variational principle for the
electromagnetic field and to obtain the Maxwell equations in the
form \er{MaxVacFull1bjkgjhjhgjaaaPPN} from this principle. Given
known the charge distribution $\rho:=\rho(\vec x,t)$, the current
distribution $\vec j:=\vec j(\vec x,t)$ and the vectorial
gravitational potential $\vec v:=\vec v(\vec x,t)$, consider a
Lagrangian density $L_1$ defined by
\begin{multline}\label{vhfffngghkjgghPPN}
L_1\left(\vec A,\Psi,\vec
x,t\right):=\frac{1}{8\pi}\left|-\nabla_{\vec
x}\Psi-\frac{1}{c}\frac{\partial\vec A}{\partial t}+\frac{1}{c}\vec
v\times curl_{\vec x}\vec A\right|^2-\frac{1}{8\pi}\left|curl_{\vec
x}\vec A\right|^2-\left(\rho\Psi-\frac{1}{c}\vec A\cdot\vec
j\right).
\end{multline}
We investigate stationary points of the functional
\begin{equation}\label{btfffygtgyggyPPN}
J=\int_0^T\int_{\mathbb{R}^3}L_1\left(\vec A,\Psi,\vec
x,t\right)d\vec x dt.
\end{equation}
We denote
\begin{equation}\label{guigjgjffghPPN}
\begin{cases}
\vec D=-\nabla_{\vec x}\Psi-\frac{1}{c}\frac{\partial\vec
A}{\partial t}+\frac{1}{c}\vec
v\times curl_{\vec x}\vec A\\
\vec B=curl_{\vec x}\vec A
\\
\vec E=-\nabla_{\vec x}\Psi-\frac{1}{c}\frac{\partial\vec A}{\partial t}=\vec D-\frac{1}{c}\vec v\times\vec B\\
\vec H=curl_{\vec x}\vec A+\frac{1}{c}\vec
v\times\left(-\nabla_{\vec x}\Psi-\frac{1}{c}\frac{\partial\vec
A}{\partial t}+\frac{1}{c}\vec v\times curl_{\vec x}\vec
A\right)=\vec B+\frac{1}{c}\vec
v\times\vec D\\
\Psi_0:=\Psi-\frac{1}{c}\vec A\cdot\vec v.
\end{cases}
\end{equation}
So we can write:
\begin{multline}\label{vhfffngghkjgghPPNggjgjjkgj}
L_1\left(\vec A,\Psi,\vec x,t\right):=\frac{1}{8\pi}\left|\vec
D\right|^2-\frac{1}{8\pi}\left|\vec
B\right|^2-\left(\rho\Psi-\frac{1}{c}\vec A\cdot\vec
j\right)\\=\frac{1}{8\pi}\left|\vec
D\right|^2-\frac{1}{8\pi}\left|\vec
B\right|^2-\rho\Psi_0+\frac{1}{c}\vec A\cdot(\vec j-\rho\vec v),
\end{multline}
and by \er{guigjgjffghPPN} we have:
\begin{equation}\label{guigjgjffghjhkkgPPN}
\begin{cases}
curl_{\vec x}\vec E+\frac{1}{c}\frac{\partial\vec B}{\partial t}=0\\
div_{\vec x}\vec B=0.
\end{cases}
\end{equation}
Moreover by \er{vhfffngghkjgghPPN} and \er{apfrm3} we have
\begin{equation}\label{vhfffngghkjgghggtghjgfhjhjkghghPPN}
0=\frac{\delta L_1}{\delta \Psi}=\frac{1}{4\pi}div_{\vec x}\vec
D-\rho,
\end{equation}
and
\begin{equation}\label{vhfffngghkjgghggtghjgfhjhjkghghyuiuuPPN}
0=\frac{\delta L_1}{\delta \vec A}=\frac{1}{c}\vec j+\frac{1}{4\pi
c}\frac{\partial\vec D}{\partial t}-\frac{1}{4\pi}curl_{\vec x}\vec
B-\frac{1}{4\pi c}curl_{\vec x}\left(\vec v\times \vec
D\right)=\frac{1}{c}\vec j+\frac{1}{4\pi c}\frac{\partial\vec
D}{\partial t}-\frac{1}{4\pi}curl_{\vec x}\vec H.
\end{equation}
So by \er{vhfffngghkjgghggtghjgfhjhjkghghPPN},
\er{vhfffngghkjgghggtghjgfhjhjkghghyuiuuPPN}, \er{guigjgjffghPPN}
and \er{guigjgjffghjhkkgPPN} we obtain the Maxwell equations in the
form:
\begin{equation}\label{guigjgjffghguygjyfPPN}
\begin{cases}
curl_{\vec x}\vec H=\frac{4\pi}{c}\vec j+\frac{\partial\vec
D}{\partial
t}\\
div_{\vec x}\vec D=4\pi\rho\\
curl_{\vec x}\vec E+\frac{1}{c}\frac{\partial\vec B}{\partial t}=0\\
div_{\vec x}\vec B=0\\
\vec E=\vec D-\frac{1}{c}\vec v\times\vec B\\
\vec H=\vec B+\frac{1}{c}\vec v\times\vec D.
\end{cases}
\end{equation}
Note also that, using \er{vhfffngghkjgghPPNggjgjjkgj}, by
\er{vhfffngghhjghhgPPNghghghutghfflklhjkjhjhjjgjkghhj} and
\er{yuythfgfyftydtydtydtyddyyyhhddhhhredPPN111} the Lagrangian $L_1$
is invariant, under the change of inertial or non-inertial
coordinate system, given by \er{noninchredPPN111}, i.e. for this
change we have
\begin{equation}\label{vhfffngghkjgghPPNggjgjjkgjjhhjkghg}
L'_1\left(\vec A',\Psi',\vec x',t'\right)=L_1\left(\vec A,\Psi,\vec
x,t\right).
\end{equation}

\subsection{Alternative Lagrangian of the Electromagnetic
field}\label{bhjghjfghfgjkjkjkhjh} Next we can associate an
alternative Lagrangian density related to electromagnetic field.
Given known the charge distribution $\rho:=\rho(\vec x,t)$, the
current distribution $\vec j:=\vec j(\vec x,t)$ and the vectorial
gravitational potential $\vec v:=\vec v(\vec x,t)$, again consider
$\vec D, \vec B, \vec E,\vec H$ be as in \er{guigjgjffghPPN}.
Then, by the elementary calculus there exist proper vector fields
$\vec A,\vec C$ and a proper scalar field $\psi_0$ such that
\begin{equation}\label{guigjgjffghPPNguyytyuyuuuuioio}
\begin{cases}
\vec D=curl_{\vec x}\vec C-\nabla_{\vec x}\psi_0\\
\vec B=curl_{\vec x}\vec A \,,
\end{cases}
\end{equation}
so that together with the identities
\begin{equation}\label{guigjgjffghguygjyfPPNyuuyuyyuyuuyuyiii}
\begin{cases}
\vec E=\vec D-\frac{1}{c}\vec v\times\vec B\\
\vec H=\vec B+\frac{1}{c}\vec v\times\vec D,
\end{cases}
\end{equation}
by \er{guigjgjffghPPNguyytyuyuuuuioio} we deduce:
\begin{equation}\label{guigjgjffghPPNguyytyuyuuuu}
\begin{cases}
\vec D=curl_{\vec x}\vec C-\nabla_{\vec x}\psi_0\\
\vec B=curl_{\vec x}\vec A
\\
\vec E=\left(curl_{\vec x}\vec C-\nabla_{\vec
x}\psi_0\right)-\frac{1}{c}\vec v\times\left(curl_{\vec x}\vec
A\right)
\\
\vec H=curl_{\vec x}\vec A+\frac{1}{c}\vec v\times\left(curl_{\vec
x}\vec C-\nabla_{\vec x}\psi_0\right)\,.
\end{cases}
\end{equation}
Then Maxwell equations \er{guigjgjffghguygjyfPPN} in the form:
\begin{equation}\label{guigjgjffghguygjyfPPNyuuyuyyuyuuyuyjjij}
\begin{cases}
curl_{\vec x}\vec H=\frac{4\pi}{c}\vec
j+\frac{1}{c}\frac{\partial\vec D}{\partial
t}\\
\Div_{\vec x}\vec D=4\pi\rho\\
curl_{\vec x}\vec E+\frac{1}{c}\frac{\partial\vec B}{\partial t}=0\\
\Div_{\vec x}\vec B=0\\
\vec E=\vec D-\frac{1}{c}\vec v\times\vec B\\
\vec H=\vec B+\frac{1}{c}\vec v\times\vec D,
\end{cases}
\end{equation}
are equivalent to
\begin{equation}\label{guigjgjffghguygjyfPPNyuuyuyyuyuuyuyjjijhhih333}
\begin{cases}
curl_{\vec x}\left\{curl_{\vec x}\vec A+\frac{1}{c}\vec
v\times\left(curl_{\vec x}\vec C-\nabla_{\vec
x}\psi_0\right)\right\}=\frac{4\pi}{c}\vec
j+\frac{1}{c}\frac{\partial}{\partial t}\left\{curl_{\vec x}\vec
C-\nabla_{\vec
x}\psi_0\right\}\\
-\Delta_{\vec x}\psi_0=4\pi\rho\\
curl_{\vec x}\left\{\left(curl_{\vec x}\vec C-\nabla_{\vec
x}\psi_0\right)-\frac{1}{c}\vec v\times\left(curl_{\vec x}\vec
A\right)\right\}+\frac{1}{c}\frac{\partial}{\partial
t}\left(curl_{\vec x}\vec A\right)=0.
\end{cases}
\end{equation}
that we rewrite as:
\begin{equation}\label{guigjgjffghguygjyfPPNyuuyuyyuyuuyuyjjijhhih}
\begin{cases}
-\Delta_{\vec x}\psi_0=4\pi\rho\\
curl_{\vec x}\left\{curl_{\vec x}\vec A\right\}-\frac{4\pi}{c}\vec
{\widetilde j}
=\frac{1}{c}\frac{\partial}{\partial t}\left\{curl_{\vec x}\vec
C\right\}-curl_{\vec x}\left\{\frac{1}{c}\vec
v\times\left(curl_{\vec x}\vec C\right)\right\}
\\
curl_{\vec x}\left(curl_{\vec x}\vec
C\right)=-\frac{1}{c}\frac{\partial}{\partial t}\left(curl_{\vec
x}\vec A\right)+curl_{\vec x}\left\{\frac{1}{c}\vec
v\times\left(curl_{\vec x}\vec A\right)\right\}.
\end{cases}
\end{equation}
where we set the reduced current:
\begin{equation}\label{reducedcurrentfhfhjfhjGGaauuui111}
\begin{cases}
\vec {\widetilde j}:=\vec j-\frac{1}{4\pi}\frac{\partial}{\partial
t} \left(\nabla_{\vec x}\psi_0\right)+\frac{1}{4\pi}curl_{\vec
x}\left(\vec v\times \nabla_{\vec x}\psi_0\right),\\
-\Delta_{\vec x}\psi_0= 4\pi\rho.
\end{cases}
\end{equation}
that clearly satisfies:
\begin{equation}\label{divreducedcurrentPPNaaoiuui111}
\Div_{\vec x}\vec {\widetilde j}\equiv 0.
\end{equation}
and
\begin{equation}\label{reducedcurrentPPNuighjhjaaioiu111}
\vec {\widetilde j}:=(\vec j-\rho\vec
v)-\frac{1}{4\pi}\left(\frac{\partial}{\partial t}
\left(\nabla_{\vec x}\psi_0\right)-curl_{\vec x}\left(\vec v\times
\nabla_{\vec x}\psi_0\right)+\left(\Div_{\vec x}\left\{\nabla_{\vec
x}\psi_0\right\}\right)\vec v\right),
\end{equation}
In particular since by \er{reducedcurrentPPNuighjhjaaioiu111} it can
be easily deduced that the reduced current $\vec {\widetilde j}$ is
a proper vector field, and as before, we deduce that
\er{guigjgjffghguygjyfPPNyuuyuyyuyuuyuyjjijhhih} is invariant under
the change of inertial or non-inertial Cartesian coordinate systems.
Next, defining the following alternative Lagrangian density:
\begin{multline}\label{vhfffngghkjgghPPNintyyyyuyyuyuyuyyy8989988989}
\tilde{L}_1\left(\vec A,\vec C,\vec x,t\right):=
\frac{1}{8\pi}\left(-\nabla_{\vec x}\left(\frac{1}{c}\vec v\cdot\vec
A\right)-\frac{1}{c}\frac{\partial\vec A}{\partial
t}+\frac{1}{c}\vec v\times curl_{\vec x}\vec
A\right)\cdot\left(curl_{\vec x}\vec
C\right)\\-\frac{1}{8\pi}\left(-\nabla_{\vec x}\left(\frac{1}{c}\vec
v\cdot\vec C\right)-\frac{1}{c}\frac{\partial\vec C}{\partial
t}+\frac{1}{c}\vec v\times curl_{\vec x}\vec
C\right)\cdot\left(curl_{\vec x}\vec A\right)
\\-\frac{1}{8\pi}\left|curl_{\vec x}\vec
A\right|^2-\frac{1}{8\pi}\left|curl_{\vec x}\vec
C\right|^2+\frac{1}{c}\vec A\cdot\vec {\widetilde j}.
\end{multline}
and considering the functional
\begin{equation}\label{btfffygtgyggyPPNuyyuyuyuuuiuu}
\tilde{J}=\int_0^T\int_{\mathbb{R}^3}L_1\left(\vec A,\vec C,\vec
x,t\right)d\vec x dt,
\end{equation}
we can easily deduce that $\tilde J$ admits the last two equations
in \er{guigjgjffghguygjyfPPNyuuyuyyuyuuyuyjjijhhih} as the
Euler-Lagrange identities. Here $\psi_0$ is assumed to be fixed and
given by the first equation in
\er{guigjgjffghguygjyfPPNyuuyuyyuyuuyuyjjijhhih}. Moreover, as
before, we deduce that the Lagrangian density $\tilde{L}_1$ in
\er{vhfffngghkjgghPPNintyyyyuyyuyuyuyyy8989988989} is also invariant
under the change of inertial or non-inertial Cartesian coordinate
systems.

Next denoting complex proper vector fields:
\begin{equation}\label{reducedcurrentfhfhjfhjGGaauuuikjjk333}
\vec T:=\vec A-i\vec C\quad\text{ana}\quad\ov{\vec T}:=\vec A+i\vec
C\,,
\end{equation}
we rewrite \er{vhfffngghkjgghPPNintyyyyuyyuyuyuyyy8989988989} as
\begin{multline}\label{vhfffngghkjgghPPNintyyyyuyyuyuyuyyy8989hjjhhj333}
\tilde{L}_1\left(\vec T,\vec x,t\right):=
\frac{i}{16\pi}\left(\nabla_{\vec x}\left(\frac{1}{c}\vec v\cdot\vec
T\right)+\frac{1}{c}\frac{\partial\vec T}{\partial
t}-\frac{1}{c}\vec v\times curl_{\vec x}\vec
T\right)\cdot\left(curl_{\vec x}\ov{\vec
T}\right)\\-\frac{i}{16\pi}\left(\nabla_{\vec
x}\left(\frac{1}{c}\vec v\cdot\ov{\vec
T}\right)+\frac{1}{c}\frac{\partial\ov{\vec T}}{\partial
t}-\frac{1}{c}\vec v\times curl_{\vec x}\ov{\vec
T}\right)\cdot\left(curl_{\vec x}\vec T\right)
\\-\frac{1}{8\pi}\left(curl_{\vec x}\vec
T\right)\cdot\left(curl_{\vec x}\ov{\vec
T}\right)+\frac{1}{2c}\left(\vec T+\ov{\vec T}\right)\cdot\vec
{\widetilde j}.
\end{multline}
and the the last two equations in
\er{guigjgjffghguygjyfPPNyuuyuyyuyuuyuyjjijhhih} together as:
\begin{multline}\label{vhfffngghkjgghPPNintyyyyuyyuyuyuyyy8989hjjhhj888555555555}
\frac{i}{c}\frac{\partial}{\partial t}\left\{curl_{\vec x}\vec
T\right\}-\frac{i}{c}curl_{\vec x}\left\{\vec v\times curl_{\vec
x}\vec T\right\}\\=i\,curl_{\vec x}\left\{\nabla_{\vec
x}\left(\frac{1}{c}\vec v\cdot\vec
T\right)+\frac{1}{c}\frac{\partial\vec T}{\partial
t}-\frac{1}{c}\vec v\times curl_{\vec x}\vec
T\right)=curl\left\{curl_{\vec x}\vec
T\right\}-\frac{4\pi}{c}\vec{\widetilde j}\,.
\end{multline}

\section{Local gravitational time and Maxwell equations in a
non-rotating coordinate system}\label{GIGIGU}
Throughout this
section consider an inertial or more generally a non-rotating
cartesian coordinate system $\bf{(*)}$. Then, as before, in this
system we have
\begin{equation}\label{jhhjgjhuiiu}
\vec v(\vec x,t)=\nabla_{\vec x}Z(\vec x,t),
\end{equation}
where $\vec v$ is the vectorial gravitational potential and $Z$ is a
scalar field. Then define a scalar field $\tau:=\tau(\vec x,t)$ by
the following:
\begin{equation}\label{jhhjgjhuiiuiy}
\tau(\vec x,t)=t+\frac{1}{c^2}Z(\vec x,t).
\end{equation}
We call the quantity $\tau(\vec x,t)$ by the name local
gravitational time. The name "local" and "gravitational" is quite
clear, since $\tau$ depend on the space and time variables and
derived by characteristic function $Z$ of the gravitational field.
The name "time" will be clarified bellow. Note also that, using
\er{noninchgravortbstrjgghguittu1intmmjhhj} in remark \ref{ugyugg},
one can easily deduce that under the change of inertial coordinate
system $\bf{(*)}$ to $\bf{(**)}$ given by the Galilean
Transformation
\begin{equation}\label{noninchgravortbstrjgghguittu1intmmkkkk}
\begin{cases}
\vec x'=\vec x+\vec wt,\\
t'=t,
\end{cases}
\end{equation}
the local gravitational time $\tau$ transforms as:
\begin{multline}\label{noninchgravortbstrjgghguittu1intmmjhhjjj}
\tau'(\vec x',t'):= t'+\frac{1}{c^2}Z'(\vec
x',t')=\left(1+\frac{|\vec w|^2}{2c^2}\right)t+\frac{1}{c^2}Z(\vec
x,t)+ \frac{1}{c^2}\vec w\cdot\vec x\\=\tau(\vec
x,t)+\frac{1}{c^2}\vec w\cdot\vec x+\frac{|\vec w|^2}{2c^2}t \approx
\tau(\vec x,t)+\frac{1}{c^2}\vec w\cdot\vec x,
\end{multline}
where the last equality in
\er{noninchgravortbstrjgghguittu1intmmjhhjjj} is valid if
$\frac{|\vec w|^2}{c^2}\ll 1$. So, under
\er{noninchgravortbstrjgghguittu1intmmkkkk} we have:
\begin{equation}\label{noninchgravortbstrjgghguittu1intmmjhhjhjhj}
\tau'\,=\,\tau+\frac{1}{c^2}\vec w\cdot\vec x+\frac{|\vec
w|^2}{2c^2}t \,\approx\, \tau+\frac{1}{c^2}\vec w\cdot\vec x,
\end{equation}
Next consider the Maxwell equations in the vacuum of the form:
\begin{equation}\label{MaxMedFullGGffgguiuiouioKK}
\begin{cases}
curl_{\vec x} \vec H= \frac{4\pi}{c}\vec j+
\frac{1}{c}\frac{\partial \vec D}{\partial t},\\
div_{\vec x}\vec D= 4\pi\rho,\\
curl_{\vec x} \vec E+\frac{1}{c}\frac{\partial \vec B}{\partial t}=0,\\
div_{\vec x} \vec B=0,\\
\vec E=
\vec D-\frac{1}{c}\,\vec v\times \vec B,\\
\vec H=
\vec B+\frac{1}{c}\,\vec v\times \vec D.
\end{cases}
\end{equation}
where $\vec E$ is the electric field, $\vec B$ is the magnetic
field, $\vec D$ is the electric displacement field and $\vec H$ is
the $\vec H$-magnetic field, $\vec v:=\vec v(\vec x,t)=\nabla_{\vec
x}Z(\vec x,t)$ is the vectorial gravitational potential, $\rho$ is
the charge density and $\vec j$ is the current density. Next
consider a curvilinear change of variables given by:
\begin{equation}\label{giuuihjghgghjgj78zzrrZZffhhhggygghghjhvbKK}
\begin{cases}
t'=\tau(\vec x,t):=t+\frac{Z(\vec x,t)}{c^2}\\
\vec x'=\vec x.
\end{cases}
\end{equation}
Then by the chain rule, for every vector field $\vec F$ we have:
\begin{equation}\label{giuuihjghgghjgj78zzrrZZffhhhggygghghjhvbKKkk}
\begin{cases}
\frac{\partial \vec F}{\partial t}=\frac{\partial \vec F}{\partial
t'}\left(1+\frac{1}{c^2}\frac{\partial Z}{\partial t}\right),\\
d_{\vec x}\vec F=d_{\vec x'}\vec F+\frac{1}{c^2}\frac{\partial \vec
F}{\partial
t'}\otimes\nabla_{\vec x}Z,\\
div_{\vec x}\vec F=div_{\vec x'}\vec F+\frac{1}{c^2}\frac{\partial
\vec
F}{\partial t'}\cdot\nabla_{\vec x}Z,\\
curl_{\vec x}\vec F=curl_{\vec x'}\vec F+\frac{1}{c^2}\nabla_{\vec
x}Z\times\frac{\partial \vec F}{\partial t'}.
\end{cases}
\end{equation}
Thus inserting \er{giuuihjghgghjgj78zzrrZZffhhhggygghghjhvbKKkk}
into \er{MaxMedFullGGffgguiuiouioKK}, since $\vec v=\nabla_{\vec
x}Z$, we deduce
\begin{equation}\label{MaxMedFullGGffgguiuiouiogghghKK}
\begin{cases}
curl_{\vec x'}\vec H+\frac{1}{c^2}\vec v\times\frac{\partial \vec
H}{\partial t'}= \frac{4\pi}{c}\vec j+ \frac{1}{c}\frac{\partial
\vec D}{\partial
t'}\left(1+\frac{1}{c^2}\frac{\partial Z}{\partial t}\right),\\
div_{\vec x'}\vec D+\frac{1}{c^2}\frac{\partial \vec
D}{\partial t'}\cdot\vec v= 4\pi\rho,\\
curl_{\vec x'}\vec E+\frac{1}{c^2}\vec v\times\frac{\partial \vec
E}{\partial t'}+\frac{1}{c}\frac{\partial \vec B}{\partial
t'}\left(1+\frac{1}{c^2}\frac{\partial Z}{\partial t}\right)=0,\\
div_{\vec x'}\vec B+\frac{1}{c^2}\frac{\partial \vec
B}{\partial t'}\cdot\vec v=0,\\
\vec E=\vec D-\frac{1}{c}\,\vec v\times \vec B,\\
\vec H=\vec B+\frac{1}{c}\,\vec v\times \vec D.
\end{cases}
\end{equation}
In particular if $Z$ is independent of $t$ or quasistatic then we
rewrite \er{MaxMedFullGGffgguiuiouiogghghKK} as:
\begin{equation}\label{MaxMedFullGGffgguiuiouiogghghhgghhjhjKK}
\begin{cases}
curl_{\vec x'}\vec H= \frac{4\pi}{c}\vec j+
\frac{1}{c}\frac{\partial }{\partial
t'}\left(\vec D-\frac{1}{c}\,\vec v\times \vec H\right),\\
div_{\vec x'}\vec D+\frac{1}{c}\vec v\cdot curl_{\vec x'}\vec H= 4\pi\left(\rho+\frac{1}{c^2}\,\vec v\cdot\vec j\right),\\
curl_{\vec x'}\vec E+\frac{1}{c}\frac{\partial}{\partial
t'}\left(\vec B+\frac{1}{c}\,\vec v\times \vec E\right)=0,\\
div_{\vec x'}\vec B-\frac{1}{c}\vec v\cdot curl_{\vec x'}\vec E=0,\\
\vec E=\vec D-\frac{1}{c}\,\vec v\times \vec B,\\
\vec H=\vec B+\frac{1}{c}\,\vec v\times \vec D.
\end{cases}
\end{equation}
I.e.
\begin{equation}\label{MaxMedFullGGffgguiuiouiogghghhgghhjhjhjjgKK}
\begin{cases}
curl_{\vec x'}\vec H= \frac{4\pi}{c}\vec j+
\frac{1}{c}\frac{\partial }{\partial
t'}\left(\vec D-\frac{1}{c}\,\vec v\times \vec H\right),\\
div_{\vec x'}\left(\vec D-\frac{1}{c}\,\vec v\times \vec H\right)= 4\pi\left(\rho+\frac{1}{c^2}\,\vec v\cdot\vec j\right),\\
curl_{\vec x'}\vec E+\frac{1}{c}\frac{\partial}{\partial
t'}\left(\vec B+\frac{1}{c}\,\vec v\times \vec E\right)=0,\\
div_{\vec x'}\left(\vec B+\frac{1}{c}\,\vec v\times \vec E\right)=0,\\
\vec E=\vec D-\frac{1}{c}\,\vec v\times \left(\vec H-\frac{1}{c}\,\vec v\times \vec D\right),\\
\vec H=\vec B+\frac{1}{c}\,\vec v\times \left(\vec
E+\frac{1}{c}\,\vec v\times \vec B\right)
\\
\vec E=\vec D-\frac{1}{c}\,\vec v\times \vec B,\\
\vec H=\vec B+\frac{1}{c}\,\vec v\times \vec D.
\end{cases}
\end{equation}
In particular, denoting
\begin{equation}\label{giuuihjghgghjgj78zzrrZZffhhhggygghghjhvbgghhjyuuyKK}
\begin{cases}
\vec E^*:=\vec D-\frac{1}{c}\,\vec v\times \vec H=\vec
E-\frac{1}{c^2}\vec v\times\left(\vec v\times\vec D\right)
\\
\vec H^*:=\vec B+\frac{1}{c}\,\vec v\times \vec E=\vec
H-\frac{1}{c^2}\vec v\times\left(\vec v\times\vec B\right),
\end{cases}
\end{equation}
by \er{MaxMedFullGGffgguiuiouiogghghhgghhjhjhjjgKK} we rewrite the
Maxwell equations in the new curvilinear coordinates in the case of
time independent $\vec v$ as:
\begin{equation}\label{MaxMedFullGGffgguiuiouiogghghhgghhjhjhjjgghhgKK}
\begin{cases}
curl_{\vec x'}\vec H= \frac{4\pi}{c}\vec j+
\frac{1}{c}\frac{\partial \vec E^*}{\partial
t'},\\
div_{\vec x'}\vec E^*= 4\pi\left(\rho+\frac{1}{c^2}\,\vec v\cdot\vec j\right),\\
curl_{\vec x'}\vec E+\frac{1}{c}\frac{\partial\vec H^*}{\partial
t'}=0,\\
div_{\vec x'}\vec H^*=0,\\
\vec E^*=\vec E-\frac{1}{c^2}\vec v\times\left(\vec v\times\vec
D\right)
\\
\vec H^*=\vec H-\frac{1}{c^2}\vec v\times\left(\vec v\times\vec
B\right)\\
\vec E=\vec D-\frac{1}{c}\,\vec v\times \vec B,\\
\vec H=\vec B+\frac{1}{c}\,\vec v\times \vec D.
\end{cases}
\end{equation}
In particular, in the approximation, up to the order
$\left(\frac{|\vec v|}{c}\right)^2\ll 1$ we have $\vec
E^*\approx\vec E$ and $\vec H^*\approx\vec H$ and then the
approximate Maxwell equations have the form:
\begin{equation}\label{MaxMedFullGGffgguiuiouiogghghhgghhjhjhjjgghhgjhgghhhjkiljklKK}
\begin{cases}
curl_{\vec x'}\vec H= \frac{4\pi}{c}\vec j+
\frac{1}{c}\frac{\partial \vec E}{\partial
t'},\\
div_{\vec x'}\vec E= 4\pi\left(\rho+\frac{1}{c^2}\,\vec v\cdot\vec j\right),\\
curl_{\vec x'}\vec E+\frac{1}{c}\frac{\partial\vec H}{\partial
t'}=0,\\
div_{\vec x'}\vec H=0,\\
\vec E=\vec D-\frac{1}{c}\,\vec v\times \vec B,\\
\vec H=\vec B+\frac{1}{c}\,\vec v\times \vec D.
\end{cases}
\end{equation}
The first four equations in
\er{MaxMedFullGGffgguiuiouiogghghhgghhjhjhjjgghhgjhgghhhjkiljklKK}
form a following system of equation:
\begin{equation}\label{MaxMedFullGGffgguiuiouiogghghhgghhjhjhjjgghhgjhgghhhjkiljklKKHH}
\begin{cases}
curl_{\vec x'}\vec H= \frac{4\pi}{c}\vec j^*+
\frac{1}{c}\frac{\partial \vec E}{\partial
t'},\\
div_{\vec x'}\vec E= 4\pi\rho^*,\\
curl_{\vec x'}\vec E+\frac{1}{c}\frac{\partial\vec H}{\partial
t'}=0,\\
div_{\vec x'}\vec H=0,
\end{cases}
\end{equation}
where
\begin{equation}\label{fgjyhyfgfjjjk}
\vec j^*:=\vec
j\quad\text{and}\quad\rho^*:=\left(\rho+\frac{1}{c^2}\,\vec
v\cdot\vec j\right)
\end{equation}
The system
\er{MaxMedFullGGffgguiuiouiogghghhgghhjhjhjjgghhgjhgghhhjkiljklKKHH}
coincides with the classical Maxwell equations of the usual
Electrodynamics and is similar to \er{MaxMedFullGGffgguiuiouioKK}
for the case $\vec v\equiv 0$. Therefore, given known $\vec v$,
$\rho$ and $\vec j$,
\er{MaxMedFullGGffgguiuiouiogghghhgghhjhjhjjgghhgjhgghhhjkiljklKKHH}
could be solved as easy as the usual wave equation, for example by
the method of retarded potentials. Then backward to
\er{giuuihjghgghjgj78zzrrZZffhhhggygghghjhvbKK} change of variables
could be made in order to deduce the electromagnetic fields in
coordinates $(\vec x,t)$. Next note that, since we defined $t'=\tau$
all the above clarifies the name "time" of the quantity $\tau$.
Finally, we would like to note that if we have a motion of some
material body with the place $\vec r(t)$ and the velocity $\vec
u(t):=\frac{d\vec r}{dt}(t)$ and we associate the local
gravitational time $\tau$ with this body then clearly
\begin{equation}\label{fgjyhyfgfjjjkjkkj}
d\tau\,=\, \left(1+\frac{1}{c^2}\vec u(t)\cdot \vec v\left(\vec
r(t),t\right)\right)\,dt\,\approx\, dt,
\end{equation}
where the last equality in \er{fgjyhyfgfjjjkjkkj} is valid if we
have
\begin{equation}\label{fgjyhyfgfjjjkkkk}
\left(\frac{|\vec v|}{c}\right)^2\ll
1\quad\text{and}\quad\left(\frac{|\vec u(t)|}{c}\right)^2\ll 1.
\end{equation}
So we can use the local gravitational time $\tau$ in the approximate
calculations instead of the true time $t$.

\section{Motion of particles in external gravitational-electromagnetic field}
\subsection{Lagrangian of the motion of a finite system of classical
particles in an outer gravitational-electromagnetic
field}\label{hggyugyuy} Given a system of $n$ particles with
inertial masses $m_1,\ldots,m_n$, charges
$\sigma_1,\ldots,\sigma_n$, places $\vec r_1(t),\ldots,\vec r_n(t)$
and velocities $\vec r'_1(t),\ldots, \vec r'_n(t)$ in the outer
gravitational field with vectorial potential $\vec v(\vec x,t)$, the
outer electromagnetical fields with potentials $\vec A(\vec x,t)$
and $\vec \Psi(\vec x,t)$ and additional conservative field with the
classical scalar potential $V(\vec y_1,\ldots,\vec y_n,t)$, consider
a Lagrangian:
\begin{multline}\label{vhfffngghkjgghfjjSYSPN}
L_0\left(\frac{d\vec r_1}{dt},\ldots,\frac{d\vec r_n}{dt},\vec
r_1,\ldots,\vec r_n,t\right):=\\
\sum_{j=1}^{n}\left\{\frac{m_j}{2}\left|\frac{d\vec r_j}{dt}-\vec
v(\vec r_j,t)\right|^2-\sigma_j\left(\Psi(\vec
r_j,t)-\frac{1}{c}\vec A(\vec r_j,t)\cdot\frac{d\vec
r_j}{dt}\right)\right\}+V\left(\vec r_1,\ldots,\vec r_n,t\right).
\end{multline}
This Lagrangian is invariant under the change of inertial and
non-inertial cartesian coordinate systems. We investigate stationary
points of the functional
\begin{equation}\label{btfffygtgyggyijhhkkSYSPN}
J_0=\int_0^T L_0\left(\frac{d\vec r_1}{dt},\ldots,\frac{d\vec
r_n}{dt},,\vec r_1,\ldots,\vec r_n,t\right)dt.
\end{equation}
Then for every $j=1,\ldots,n$ we have
\begin{multline}\label{vhfffngghkjgghggtghjgfhjoyuiyuyhiyyukukySYSPN}
\frac{\delta L_0}{\delta \vec r_j}=-m_j\frac{d}{dt}\left(\frac{d\vec
r_j}{dt}-\vec v(\vec
r_j,t)\right)-\frac{\sigma_j}{c}\frac{d}{dt}\left(\vec A(\vec
r_j,t)\right)-m_j\left\{\nabla_{\vec x}\vec v(\vec
r_j,t)\right\}^T\cdot\left(\frac{d\vec r_j}{dt}-\vec v(\vec
r_j,t)\right)\\-\sigma_j\left(\nabla_{\vec x}\Psi(\vec
r_j,t)-\frac{1}{c}\left\{d_{\vec x}\vec A(\vec
r_j,t)\right\}^T\cdot\frac{d\vec r_j}{dt}\right)+\nabla_{\vec
y_j}V\left(\vec r_1,\ldots,\vec r_n,t\right)=\\-m_j\frac{d^2\vec
r_j}{dt^2}+m_j\left(\frac{\partial}{\partial t}\vec v(\vec
r_j,t)+\nabla_{\vec x}\left(\frac{1}{2}\left|\vec v(\vec
r_j,t)\right|^2\right)-\frac{1}{c}\frac{d\vec r_j}{dt}\times
curl_{\vec x}\vec v(\vec r_j,t)\right)\\+\sigma_j\left(-\nabla_{\vec
x}\Psi(\vec r_j,t)-\frac{1}{c}\frac{\partial}{\partial t}\left(\vec
A(\vec r_j,t)\right)+\frac{1}{c}\frac{d\vec r_j}{dt}\times
curl_{\vec x}\vec A(\vec r_j,t)\right)+\nabla_{\vec y_j}V\left(\vec
r_1,\ldots,\vec r_n,t\right)=0,
\end{multline}
So denoting
\begin{equation}\label{guigjgjffghguygjyfghgghjfgfSYSPN}
\begin{cases}
\vec E=-\nabla_{\vec x}\Psi-\frac{1}{c}\frac{\partial\vec
A}{\partial t}\\
\vec B=curl_{\vec x}\vec A
\end{cases}
\end{equation}
we rewrite \er{vhfffngghkjgghggtghjgfhjoyuiyuyhiyyukukySYSPN} as
\begin{multline}\label{vhfffngghkjgghggtghjgfhjoyuiyuyhiyyukukyihyuSYSPN}
m_j\frac{d^2\vec r_j}{dt^2}=m_j\left(\frac{\partial}{\partial t}\vec
v(\vec r_j,t)+\nabla_{\vec x}\left(\frac{1}{2}\left|\vec v(\vec
r_j,t)\right|^2\right)-\frac{d\vec r_j}{dt}\times curl_{\vec x}\vec
v(\vec r_j,t)\right)+\sigma_j\vec E(\vec
r_j,t)+\frac{\sigma_j}{c}\frac{d\vec r_j}{dt}\times \vec B(\vec
r_j,t)\\+\nabla_{\vec y_j}V\left(\vec r_1,\ldots,\vec
r_n,t\right)=\sigma_j\vec E(\vec
r_j,t)+\frac{\sigma_j}{c}\frac{d\vec r_j}{dt}\times \vec B(\vec
r_j,t)+\nabla_{\vec y_j}V\left(\vec r_1,\ldots,\vec r_n,t\right)+\\
m_j\left(\frac{\partial}{\partial t}\vec v(\vec r_j,t)+d_{\vec
x}\vec v(\vec r_j,t)\cdot\vec v(\vec r_j,t)-\left(\frac{d\vec
r_j}{dt}-\vec v(\vec r_j,t)\right)\times curl_{\vec x}\vec v(\vec
r_j,t)\right).
\end{multline}
So for each particle we get the second law of Newton, consistent
with
\er{noninchgravortbstrjgghguittu2gjgghhjhghjhjgghgghghghtytythvfghfgghjgg},
including the gravitational and the Lorentz force.

Next, assume that our coordinate system is inertial. Then since by
\er{MaxVacFull1ninshtrgravortghhghgjkgghklhjgkghghjjkjhjkkggjkhjkhjjhhfhjhklkhkhjjklzzzyyyhjggjhgghhjhNWNWNWNWNWBWHWPPN222}
we have the following Hamilton-Jacobi type equation
\begin{equation}
\label{MaxVacFull1ninshtrgravortghhghgjkgghklhjgkghghjjkjhjkkggjkhjkhjjhhfhjhklkhkhjjklzzzyyyhjggjhgghhjhNWNWNWNWNWBWHWPPN222kk}
\begin{cases}
\vec v=\nabla_{\vec x}Z,\\
\frac{\partial Z}{\partial t}+\frac{1}{2}\left|\nabla_{\vec
x}Z\right|^2=-\Phi,
\end{cases}
\end{equation}
where $Z:=Z(\vec x,t)$ is some scalar field and $\Phi$ is the
Newtonian gravitational potential, using \er{vhfffngghkjgghfjjSYSPN}
we deduce:
\begin{multline}\label{vhfffngghkjgghfjjSYSPNkk}
L_0\left(\frac{d\vec r_1}{dt},\ldots,\frac{d\vec r_n}{dt},\vec
r_1,\ldots,\vec r_n,t\right)=\\
\sum_{j=1}^{n}\left\{\frac{m_j}{2}\left|\frac{d\vec
r_j}{dt}-\nabla_{\vec x}Z(\vec
r_j,t)\right|^2-\sigma_j\left(\Psi(\vec r_j,t)-\frac{1}{c}\vec
A(\vec r_j,t)\cdot\frac{d\vec r_j}{dt}\right)\right\}+V\left(\vec
r_1,\ldots,\vec
r_n,t\right)\\=\sum_{j=1}^{n}m_j\left(\frac{1}{2}\left|\nabla_{\vec
x}Z(\vec r_j,t)\right|^2-\nabla_{\vec x}Z(\vec
r_j,t)\cdot\frac{d\vec r_j}{dt}\right)\\+
\sum_{j=1}^{n}\left\{\frac{m_j}{2}\left|\frac{d\vec
r_j}{dt}\right|^2-\sigma_j\left(\Psi(\vec r_j,t)-\frac{1}{c}\vec
A(\vec r_j,t)\cdot\frac{d\vec r_j}{dt}\right)\right\}+V\left(\vec
r_1,\ldots,\vec r_n,t\right)\\
=\sum_{j=1}^{n}m_j\left(\frac{\partial Z}{\partial t}(\vec
r_j,t)+\frac{1}{2}\left|\nabla_{\vec x}Z(\vec
r_j,t)\right|^2\right)-\sum_{j=1}^{n}m_j\frac{d}{dt}\left\{Z\left(\vec
r_j(t),t\right)\right\}\\+
\sum_{j=1}^{n}\left\{\frac{m_j}{2}\left|\frac{d\vec
r_j}{dt}\right|^2-\sigma_j\left(\Psi(\vec r_j,t)-\frac{1}{c}\vec
A(\vec r_j,t)\cdot\frac{d\vec r_j}{dt}\right)\right\}+V\left(\vec
r_1,\ldots,\vec r_n,t\right)=\\
\sum_{j=1}^{n}\left\{\frac{m_j}{2}\left|\frac{d\vec
r_j}{dt}\right|^2-m_j\Phi(\vec r_j,t)-\sigma_j\left(\Psi(\vec
r_j,t)-\frac{1}{c}\vec A(\vec r_j,t)\cdot\frac{d\vec
r_j}{dt}\right)\right\}+V\left(\vec r_1,\ldots,\vec r_n,t\right)\\
-\frac{d}{dt}\left\{\sum_{j=1}^{n}m_jZ\left(\vec
r_j(t),t\right)\right\}.
\end{multline}
So we rewrite \er{vhfffngghkjgghfjjSYSPN} as:
\begin{multline}\label{vhfffngghkjgghfjjSYSPNkkjjkk}
L_0\left(\frac{d\vec r_1}{dt},\ldots,\frac{d\vec r_n}{dt},\vec
r_1,\ldots,\vec r_n,t\right)=L'_0\left(\frac{d\vec
r_1}{dt},\ldots,\frac{d\vec r_n}{dt},\vec r_1,\ldots,\vec
r_n,t\right)-\frac{d}{dt}\left\{\sum_{j=1}^{n}m_jZ\left(\vec
r_j(t),t\right)\right\}.
\end{multline}
where
\begin{multline}\label{vhfffngghkjgghfjjSYSPNkkjjkkhhkk}
L'_0\left(\frac{d\vec r_1}{dt},\ldots,\frac{d\vec r_n}{dt},\vec
r_1,\ldots,\vec r_n,t\right):=\\
\sum_{j=1}^{n}\left\{\frac{m_j}{2}\left|\frac{d\vec
r_j}{dt}\right|^2-m_j\Phi(\vec r_j,t)-\sigma_j\left(\Psi(\vec
r_j,t)-\frac{1}{c}\vec A(\vec r_j,t)\cdot\frac{d\vec
r_j}{dt}\right)\right\}+V\left(\vec r_1,\ldots,\vec r_n,t\right).
\end{multline}
Note that in the given inertial coordinate system $L'_0$ coincides
with the classical Lagrangian of motion in the gravitational and
electromagnetic fields . Moreover, we rewrite
\er{btfffygtgyggyijhhkkSYSPN} as:
\begin{equation}\label{btfffygtgyggyijhhkkSYSPNkk}
J_0=\int_0^T L'_0\left(\frac{d\vec r_1}{dt},\ldots,\frac{d\vec
r_n}{dt},\vec r_1,\ldots,\vec
r_n,t\right)dt+\sum_{j=1}^{n}m_j\left(Z\left(\vec
r_j(0),0\right)-Z\left(\vec r_j(T),T\right)\right).
\end{equation}
Thus the stationary points of the functional $J_0$ will satisfy the
same Euler-Lagrange equations as the stationary points of the
functional
\begin{equation}\label{btfffygtgyggyijhhkkSYSPNkkllkk}
J'_0=\int_0^T L'_0\left(\frac{d\vec r_1}{dt},\ldots,\frac{d\vec
r_n}{dt},\vec r_1,\ldots,\vec r_n,t\right)dt,
\end{equation}
provided that the beginning and the ending points of trajectories
$\vec r_j(t)$ are fixed.

\subsection{Hamiltonian of the motion of a finite system of classical
particles in an outer gravitational-electromagnetic
field}\label{hggyugyuyhkhj} For every $j=1,\ldots,n$ define the
generalized momentum of the particle $m_j$ by
\begin{equation}\label{guytyurtydftyiujhSYSPN}
\vec P_j:=\nabla_{\vec r'_j}L_0\left(\vec r'_1,\ldots,\vec r'_n,\vec
r_1,\ldots,\vec r_n,t\right)=m_j \frac{d\vec r_j}{dt}-m_j\vec v(\vec
r_j,t)+\frac{\sigma_j}{c}\vec A(\vec r_j,t),
\end{equation}
where $L_0$ is given by \er{vhfffngghkjgghfjjSYSPN}. Then
\begin{equation}\label{guytyurtydftyiujhioyhuyhSYSPN}
\frac{d\vec r_j}{dt}=\frac{1}{m_j}\vec P_j+\vec v(\vec
r_j,t)-\frac{\sigma_j}{m_jc}\vec A(\vec r_j,t).
\end{equation}
Thus if we consider a Hamiltonian
\begin{equation}\label{vhfffngghkjgghfjjhyjjfgSYSPN}
H_0\left(\vec P_1,\ldots,\vec P_n,\vec r_1,\ldots,\vec
r_n,t\right):=\sum_{j=1}^{n}\vec P_j\cdot\frac{d\vec
r_j}{dt}-L_0\left(\frac{d\vec r_1}{dt},\ldots,\frac{d\vec
r_n}{dt},\vec r_1,\ldots,\vec r_n,t\right)
\end{equation}
then by \er{vhfffngghkjgghfjjSYSPN},
\er{vhfffngghkjgghfjjhyjjfgSYSPN} and
\er{guytyurtydftyiujhioyhuyhSYSPN} we have:
\begin{multline}\label{vhfffngghkjgghfjjghghghSYSPN}
H_0\left(\vec P_1,\ldots,\vec P_n,\vec r_1,\ldots,\vec
r_n,t\right)=-V\left(\vec r_1,\ldots,\vec
r_n,t\right)+\sum_{j=1}^{n}\vec P_j\cdot\frac{d\vec
r_j}{dt}\\
-\sum_{j=1}^{n}\left(\frac{m_j}{2}\left|\frac{d\vec r_j}{dt}-\vec
v(\vec r_j,t)\right|^2-\sigma_j\left(\Psi(\vec
r_j,t)-\frac{1}{c}\vec A(\vec
r_j,t)\cdot\frac{d\vec r_j}{dt}\right)\right)=\\
\sum_{j=1}^{n}\vec P_j\cdot\left(\frac{1}{m_j}\vec P_j+\vec v(\vec
r_j,t)-\frac{\sigma_j}{m_jc}\vec A(\vec
r_j,t)\right)-\sum_{j=1}^{n}\frac{m_j}{2}\left|\frac{1}{m_j}\vec
P_j-\frac{\sigma_j}{m_jc}\vec A(\vec
r_j,t)\right|^2\\+\sum_{j=1}^{n}\sigma_j\left(\Psi(\vec
r_j,t)-\frac{1}{c}\vec A(\vec r_j,t)\cdot\left(\frac{1}{m_j}\vec
P_j+\vec v(\vec r_j,t)-\frac{\sigma_j}{m_jc}\vec A(\vec
r_j,t)\right)\right)-V\left(\vec r_1,\ldots,\vec r_n,t\right) =\\
\sum_{j=1}^{n}\vec P_j\cdot\vec v(\vec r_j,t)+
\sum_{j=1}^{n}\frac{1}{2m_j}\left|\vec P_j-\frac{\sigma_j}{c}\vec
A(\vec r_j,t)\right|^2+\sum_{j=1}^{n}\sigma_j\left(\Psi(\vec
r_j,t)-\frac{1}{c}\vec A(\vec r_j,t)\cdot\vec v(\vec
r_j,t)\right)-V\left(\vec r_1,\ldots,\vec r_n,t\right).
\end{multline}

\subsection{Classical Liouville's equation}\label{gghfhfhfhfh}
Assume that the number of particles $n$ in the system, ruled by the
Hamiltonian \er{vhfffngghkjgghfjjghghghSYSPN}, is large and we
describe this system statistically. Then let $w\left(\vec
P_1,\ldots,\vec P_n,\vec r_1,\ldots,\vec r_n,t\right)\to[0,+\infty)$
be the probability density of the system which satisfies the well
known classical Liouville's equation of the form:
\begin{multline}\label{vhfffngghkjgghfjjghghghhjghjgghkghggSYSPNkljljjmjkj}
\frac{\partial w}{\partial t}+\sum_{j=1}^{n}\left(div_{\vec
r_j}\left\{w\,\nabla_{\vec
P_j}H_0\right\}-div_{\vec P_j}\left\{w\,\nabla_{\vec r_j}H_0\right\}\right)=\\
\frac{\partial w}{\partial t}+\sum_{j=1}^{n}\left(\nabla_{\vec
P_j}H_0\cdot\nabla_{\vec r_j}w-\nabla_{\vec r_j}H_0\cdot\nabla_{\vec
P_j}w\right)=0.
\end{multline}
Then since by \er{vhfffngghkjgghfjjghghghSYSPN} we have
\begin{multline}\label{vhfffngghkjgghfjjghghghSYSPNjljl}
H_0\left(\vec P_1,\ldots,\vec P_n,\vec r_1,\ldots,\vec
r_n,t\right)=\\
\sum_{j=1}^{n}\vec P_j\cdot\vec v(\vec r_j,t)+
\sum_{j=1}^{n}\frac{1}{2m_j}\left|\vec P_j-\frac{\sigma_j}{c}\vec
A(\vec r_j,t)\right|^2+\sum_{j=1}^{n}\sigma_j\left(\Psi(\vec
r_j,t)-\frac{1}{c}\vec A(\vec r_j,t)\cdot\vec v(\vec
r_j,t)\right)-V\left(\vec r_1,\ldots,\vec r_n,t\right),
\end{multline}
and in particular,
\begin{multline}\label{vhfffngghkjgghfjjghghghhjghjgghkghggSYSPNkljljjmjkjjjkkjlkjkljljjkjkkjjkl}
\nabla_{\vec P_j}H_0=\frac{1}{m_j}\left(\vec
P_j-\frac{\sigma_j}{c}\vec A(\vec r_j,t)\right)+\vec v( \vec
r_j,t)\quad\text{and}\\ \nabla_{\vec r_j}H_0= \left\{d_{\vec x}\vec
v(\vec r_j,t)\right\}^T\cdot\vec P_j-\frac{\sigma_j}{c
m_j}\left\{d_{\vec x}\vec A(\vec r_j,t)\right\}^T\cdot\left(\vec
P_j-\frac{\sigma_j}{c}\vec A(\vec
r_j,t)\right)\\+\sigma_j\nabla_{\vec x}\left(\Psi(\vec
r_j,t)-\frac{1}{c}\vec A(\vec r_j,t)\cdot\vec v(\vec
r_j,t)\right)-\nabla_{\vec y_j}V\left(\vec r_1,\ldots,\vec
r_n,t\right)\quad\quad\forall j=1,\ldots n,
\end{multline}
inserting
\er{vhfffngghkjgghfjjghghghhjghjgghkghggSYSPNkljljjmjkjjjkkjlkjkljljjkjkkjjkl}
into \er{vhfffngghkjgghfjjghghghhjghjgghkghggSYSPNkljljjmjkj} gives
\begin{multline}\label{vhfffngghkjgghfjjghghghhjghjgghkghggSYSPNkljljjmjkjhhhjhigjgj}
0=\frac{\partial w}{\partial t}+\sum_{j=1}^{n}\nabla_{\vec
P_j}H_0\cdot\nabla_{\vec r_j}w-\sum_{j=1}^{n}\nabla_{\vec
r_j}H_0\cdot\nabla_{\vec P_j}w=\\
\frac{\partial w}{\partial t} +\sum_{j=1}^{n}\vec v( \vec
r_j,t)\cdot\nabla_{\vec r_j}w +\sum_{j=1}^{n}\frac{1}{m_j}\left(\vec
P_j-\frac{\sigma_j}{c}\vec A(\vec r_j,t)\right)\cdot\nabla_{\vec
r_j}w -\sum_{j=1}^{n}\left(\left\{d_{\vec x}\vec v(\vec
r_j,t)\right\}^T\cdot\vec P_j\right)\cdot\nabla_{\vec
P_j}w\\
+ \sum_{j=1}^{n}\left(\frac{\sigma_j}{c m_j}\left\{d_{\vec x}\vec
A(\vec r_j,t)\right\}^T\cdot\left(\vec P_j-\frac{\sigma_j}{c}\vec
A(\vec r_j,t)\right)\right)\cdot\nabla_{\vec P_j}w
\\
-\sum_{j=1}^{n}\left(\sigma_j\nabla_{\vec x}\left(\Psi(\vec
r_j,t)-\frac{1}{c}\vec A(\vec r_j,t)\cdot\vec v(\vec
r_j,t)\right)-\nabla_{\vec y_j}V\left(\vec r_1,\ldots,\vec
r_n,t\right)\right)\cdot\nabla_{\vec P_j}w
=\\
\frac{\partial w}{\partial t} +\sum_{j=1}^{n}\vec v( \vec
r_j,t)\cdot\nabla_{\vec r_j}w +\sum_{j=1}^{n}\frac{1}{m_j}\left(\vec
P_j-\frac{\sigma_j}{c}\vec A(\vec r_j,t)\right)\cdot\nabla_{\vec
r_j}w \\+\sum_{j=1}^{n}\frac{1}{2}\left(\left(d_{\vec x}\vec v(\vec
r_j,t)-\left\{d_{\vec x}\vec v(\vec r_j,t)\right\}^T\right)\cdot\vec
P_j\right)\cdot\nabla_{\vec
P_j}w-\sum_{j=1}^{n}\frac{1}{2}\left(\left(d_{\vec x}\vec v(\vec
r_j,t)+\left\{d_{\vec x}\vec v(\vec r_j,t)\right\}^T\right)\cdot\vec
P_j\right)\cdot\nabla_{\vec
P_j}w\\
+ \sum_{j=1}^{n}\left(\frac{\sigma_j}{c m_j}\left\{d_{\vec x}\vec
A(\vec r_j,t)\right\}^T\cdot\left(\vec P_j-\frac{\sigma_j}{c}\vec
A(\vec r_j,t)\right)\right)\cdot\nabla_{\vec P_j}w
\\
-\sum_{j=1}^{n}\left(\sigma_j\nabla_{\vec x}\left(\Psi(\vec
r_j,t)-\frac{1}{c}\vec A(\vec r_j,t)\cdot\vec v(\vec
r_j,t)\right)-\nabla_{\vec y_j}V\left(\vec r_1,\ldots,\vec
r_n,t\right)\right)\cdot\nabla_{\vec P_j}w.
\end{multline}
Thus, by
\er{vhfffngghkjgghfjjghghghhjghjgghkghggSYSPNkljljjmjkjhhhjhigjgj},
using \er{apfrm9}, we rewrite the Liouville's equation as:
\begin{multline}\label{vhfffngghkjgghfjjghghghhjghjgghkghggSYSPNkljljjmjkjhhhjhigjgjjhjgjg}
\frac{\partial w}{\partial t} +\sum_{j=1}^{n}\vec v( \vec
r_j,t)\cdot\nabla_{\vec r_j}w
+\sum_{j=1}^{n}\frac{1}{2}\left(\left(curl_{\vec x}\vec v(\vec
r_j,t)\right)\times\vec P_j\right)\cdot\nabla_{\vec P_j}w
+\sum_{j=1}^{n}\frac{1}{m_j}\left(\vec P_j-\frac{\sigma_j}{c}\vec
A(\vec r_j,t)\right)\cdot\nabla_{\vec r_j}w
\\-\sum_{j=1}^{n}\frac{1}{2}\left(\left(d_{\vec x}\vec v(\vec
r_j,t)+\left\{d_{\vec x}\vec v(\vec r_j,t)\right\}^T\right)\cdot\vec
P_j\right)\cdot\nabla_{\vec P_j}w +
\sum_{j=1}^{n}\left(\frac{\sigma_j}{c m_j}\left\{d_{\vec x}\vec
A(\vec r_j,t)\right\}^T\cdot\left(\vec P_j-\frac{\sigma_j}{c}\vec
A(\vec r_j,t)\right)\right)\cdot\nabla_{\vec P_j}w
\\
-\sum_{j=1}^{n}\left(\sigma_j\nabla_{\vec x}\left(\Psi(\vec
r_j,t)-\frac{1}{c}\vec A(\vec r_j,t)\cdot\vec v(\vec
r_j,t)\right)-\nabla_{\vec y_j}V\left(\vec r_1,\ldots,\vec
r_n,t\right)\right)\cdot\nabla_{\vec P_j}w=0.
\end{multline}
Next if the change of some non-inertial cartesian coordinate system
$(*)$ to another cartesian coordinate system $(**)$ is of the form
\er{noninchgravortbstrjgghguittu2}:
\begin{equation}\label{noninchgravortbstrjgghguittu2jjjk}
\begin{cases}
\vec x'=A(t)\cdot\vec x+\vec z(t),\\
t'=t,
\end{cases}
\end{equation}
where $A(t)\in SO(3)$ is a rotation, then consistently with
\er{noninchgravortbstrjgghguittu2jjjk} and
\er{guytyurtydftyiujhSYSPN} we have the following change of
variables $(\vec P_1,\ldots,\vec P_n,\vec x_1,\ldots,\vec
x_n,t)\to(\vec P'_1,\ldots,\vec P'_n,\vec x'_1,\ldots,\vec
x'_n,t')$:
\begin{equation}\label{noninchgravortbstrjgghguittu2intrrrZZHHjkjhjh}
\begin{cases}
t'=t,\\
\vec x'_k=A(t)\cdot\vec x_k+\vec z(t)\quad\quad\forall
k=1,\ldots,n,\\
\vec P'_k=A(t)\cdot\vec P_k\quad\quad\forall k=1,\ldots,n.
\end{cases}
\end{equation}
Thus, since consistently with \er{noninchgravortbstrjgghguittu2jjjk}
we have
\begin{equation}\label{yuythfgfyftydtydtydtyddyyyhhddhhhredPPN111hgghjgintintrrZZHHlll}
\begin{cases}
V'\left(\vec x'_1,\ldots,\vec x'_n,t'\right)\,=\,V\left(\vec
x_1,\ldots,\vec x_n,t\right),\\
\sigma'_j\,=\,\sigma_j,\\
m'_j\,=\,m_j,\\
\vec v'(\vec x',t)\,=\,A(t)\cdot \vec v(\vec x,t)+\frac{dA}{dt}(t)\cdot\vec x+\frac{d\vec z}{dt}(t),\\
\vec A'(\vec x',t)\,=\,A(t)\cdot \vec A(\vec x,t),\\
\Psi'(\vec x',t)-\vec v'(\vec x',t)\cdot\vec A'(\vec
x',t)\,=\,\Psi(\vec x,t)-\vec v(\vec x,t)\cdot\vec A(\vec x,t),
\end{cases}
\end{equation}
by \er{noninchgravortbstrjgghguittu2intrrrZZHHjkjhjh} and
\er{yuythfgfyftydtydtydtyddyyyhhddhhhredPPN111hgghjgintintrrZZHHlll}
we deduce that the Liouville equation
\er{vhfffngghkjgghfjjghghghhjghjgghkghggSYSPNkljljjmjkjhhhjhigjgjjhjgjg}
is invariant under the the change of non-inertial cartesian
coordinate system of the form \er{noninchgravortbstrjgghguittu2}.

\subsubsection{Thermodynamical equilibrium; canonical and micro-canonical
ensembles}\label{hjgutgyutyyt}
Next let $w\left(\vec P_1,\ldots,\vec
P_n,\vec r_1,\ldots,\vec r_n,t\right)\to[0,+\infty)$ be the
probability density of the system, ruled by the Hamiltonian $H_0$
from \er{vhfffngghkjgghfjjghghghSYSPNjljl}, in the case of
\underline{thermodynamical equilibrium}. Then in the case that
$\frac{\partial H_0}{\partial t}\equiv 0$ and the given equilibrium
system rests macroscopically, i.e. it has the macroscopical velocity
field zero: $\vec u(\vec x,t)\equiv 0$ it is well known that
\begin{equation}\label{vhfffngghkjgghfjjghghghSYSPNjljllkljjhkhkhklpl;lpoopkklkljl}
w\left(\vec P_1,\ldots,\vec P_n,\vec r_1,\ldots,\vec
r_n,t\right):=\frac{1}{K(f,H_{0})}f\left(H_{0}\left(\vec
P_1,\ldots,\vec P_n,\vec r_1,\ldots,\vec r_n,t\right)\right),
\end{equation}
with
\begin{equation}\label{vhfffngghkjgghfjjghghghSYSPNjljllkljjhkhkhklpl;liuiouiojjk}
K(f,H_{0}):=\int f\left(H_{0}\left(\vec P_1,\ldots,\vec P_n,\vec
r_1,\ldots,\vec r_n,t\right)\right)d\vec P_1\ldots d\vec P_nd\vec
r_1\ldots d\vec r_n,
\end{equation}
and,
\begin{itemize}
\item in the case of a system in thermostat, having the Kelvin temperature $T$, we have the canonical ensemble: $f(s)=e^{-\frac{s}{kT}}\;\,\forall s$, where $k$ is the Boltzmann
constant,
\item in the case of a thermally isolated system with the average internal energy $E$ we have the micro-canonical ensemble: $f(s)=\delta(s-E)\;\,\forall
s$.
\end{itemize}
We would like to find alternative forms of the above laws of
thermodynamical equilibrium, which are invariant under the change of
inertial or non-inertial cartesian coordinate system. Then it is
clear that if the given system rests in the old coordinate system,
then it obviously has non-trivial macroscopical velocity field $\vec
u(\vec x,t)$ in the new one. Moreover, $\vec u(\vec x,t)$ can depend
on $\vec x$ and $t$, as it indeed happens in the case of a rotation
of the new coordinate system with respect to the old one. On the
other hand the concept of thermodynamical equilibrium is clearly
independent of the coordinate system.

In order to find the forms of the laws of thermodynamical
equilibrium, which are indeed invariant under the change of inertial
or non-inertial cartesian coordinate system we follow the steps as
below. By \er{vhfffngghkjgghfjjghghghSYSPNjljl} the Hamiltonian
$H_0$ of our system has the following form:
\begin{multline}\label{vhfffngghkjgghfjjghghghSYSPNjljll;kl;}
H_0\left(\vec P_1,\ldots,\vec P_n,\vec r_1,\ldots,\vec
r_n,t\right)=\\
\sum_{j=1}^{n}\vec P_j\cdot\vec v(\vec r_j,t)+
\sum_{j=1}^{n}\frac{1}{2m_j}\left|\vec P_j-\frac{\sigma_j}{c}\vec
A(\vec r_j,t)\right|^2+\sum_{j=1}^{n}\sigma_j\left(\Psi(\vec
r_j,t)-\frac{1}{c}\vec A(\vec r_j,t)\cdot\vec v(\vec
r_j,t)\right)-V\left(\vec r_1,\ldots,\vec r_n,t\right),
\end{multline}
Next define an invariant quantity $H_{\vec u}$ as:
\begin{multline}\label{vhfffngghkjgghfjjghghghSYSPNjljll'l}
H_{\vec u}\left(\vec P_1,\ldots,\vec P_n,\vec r_1,\ldots,\vec
r_n,t\right):=\\H_0\left(\vec P_1,\ldots,\vec P_n,\vec
r_1,\ldots,\vec r_n,t\right)-\sum_{j=1}^{n}\vec P_j\cdot\vec u(\vec
r_j,t)=
\sum_{j=1}^{n}\vec P_j\cdot\left(\vec v(\vec r_j,t)-\vec u(\vec
r_j,t)\right)\\+ \sum_{j=1}^{n}\frac{1}{2m_j}\left|\vec
P_j-\frac{\sigma_j}{c}\vec A(\vec
r_j,t)\right|^2+\sum_{j=1}^{n}\sigma_j\left(\Psi(\vec
r_j,t)-\frac{1}{c}\vec A(\vec r_j,t)\cdot\vec v(\vec
r_j,t)\right)-V\left(\vec r_1,\ldots,\vec r_n,t\right),
\end{multline}
where $\vec u(\vec x,t)$ is the macroscopical velocity field of the
given system of particles. Then, as before, it can be easily deduced
that under the change of some non-inertial cartesian coordinate
system $(*)$ to another cartesian coordinate system $(**)$ of the
form \er{noninchgravortbstrjgghguittu2jjjk} the quantity $H_{\vec
u}$ transforms as:
\begin{equation}\label{vhfffngghkjgghfjjghghghSYSPNjljllkljjhkhkhklpl;lpoopkklkljljkhhj}
H'_{\vec u'}=H_{\vec u},
\end{equation}
provided that $\vec u$ is the speed-like vector field i.e. under the
above change we have:
\begin{equation}\label{yuythfgfyftydtydtydtyddyyyhhddhhhredPPN111hgghjgintintrrZZHHlllkljjjk}
\vec u'(\vec x',t)\,=\,A(t)\cdot \vec u(\vec
x,t)+\frac{dA}{dt}(t)\cdot\vec x+\frac{d\vec z}{dt}(t),
\end{equation}
and we have \er{noninchgravortbstrjgghguittu2intrrrZZHHjkjhjh} and
\er{yuythfgfyftydtydtydtyddyyyhhddhhhredPPN111hgghjgintintrrZZHHlll}.
Next consider the canonical and micro-canonical ensembles as:
\begin{equation}\label{vhfffngghkjgghfjjghghghSYSPNjljllkljjhkhkhklpl;lpoop}
w\left(\vec P_1,\ldots,\vec P_n,\vec r_1,\ldots,\vec
r_n,t\right):=\frac{1}{K(f,H_{\vec u})}f\left(H_{\vec u}\left(\vec
P_1,\ldots,\vec P_n,\vec r_1,\ldots,\vec r_n,t\right)\right),
\end{equation}
where as before,
\begin{equation}\label{vhfffngghkjgghfjjghghghSYSPNjljllkljjhkhkhklpl;liuiouio}
K(f,H_{\vec u}):=\int f\left(H_{\vec u}\left(\vec P_1,\ldots,\vec
P_n,\vec r_1,\ldots,\vec r_n,t\right)\right)d\vec P_1\ldots d\vec
P_nd\vec r_1\ldots d\vec r_n,
\end{equation}
and,
\begin{itemize}
\item in the case of a system in thermostat, having the Kelvin temperature $T$, we have: $f(s)=e^{-\frac{s}{kT}}\;\,\forall
s$,
\item in the case of a thermally isolated system with the average internal energy $E$ we have: $f(s)=\delta(s-E)\;\,\forall
s$.
\end{itemize}
Then, by
\er{vhfffngghkjgghfjjghghghSYSPNjljllkljjhkhkhklpl;lpoopkklkljljkhhj}
we deduce that under the change of some non-inertial cartesian
coordinate system $(*)$ to another cartesian coordinate system
$(**)$ of the form \er{noninchgravortbstrjgghguittu2jjjk} the
quantity $w$ transforms as:
\begin{equation}\label{vhfffngghkjgghfjjghghghSYSPNjljllkljjhkhkhklpl;lpoopkklkljljkhhjkhkhk}
w'=w.
\end{equation}
Moreover, in the case $\vec u\equiv 0$
\er{vhfffngghkjgghfjjghghghSYSPNjljllkljjhkhkhklpl;lpoop} coincides
with
\er{vhfffngghkjgghfjjghghghSYSPNjljllkljjhkhkhklpl;lpoopkklkljl}.
However, we still need to derive the restrictions on the field $\vec
u$ and the Hamiltonian $H_0$, providing that our system can indeed
be found in the state of thermodynamical equilibrium. We remind that
in the case $\vec u\equiv 0$ the appropriate restriction is
$\frac{\partial H_0}{\partial t}\equiv 0$. In order to get these
restrictions in the general case, we need to insert $w$ in
\er{vhfffngghkjgghfjjghghghSYSPNjljllkljjhkhkhklpl;lpoop} into the
Liouville's equation in
\er{vhfffngghkjgghfjjghghghhjghjgghkghggSYSPNkljljjmjkj} having the
form:
\begin{equation}\label{vhfffngghkjgghfjjghghghhjghjgghkghggSYSPNkljljjmjkjkjjk}
\frac{\partial w}{\partial t}+\sum_{j=1}^{n}\left(\nabla_{\vec
P_j}H_0\cdot\nabla_{\vec r_j}w-\nabla_{\vec r_j}H_0\cdot\nabla_{\vec
P_j}w\right)=0.
\end{equation}
By \er{vhfffngghkjgghfjjghghghSYSPNjljll'l} we have:
\begin{multline}\label{vhfffngghkjgghfjjghghghhjghjgghkghggSYSPNkljljjmjkjjjkkjlkjkljljjkjkkjjklkjjkjh}
\nabla_{\vec P_j}H_{\vec u}=\nabla_{\vec P_j}H_0-\vec u( \vec
r_j,t)\quad\text{and}\quad \nabla_{\vec r_j}H_{\vec u}=\nabla_{\vec
r_j}H_0-\left\{d_{\vec x}\vec u(\vec r_j,t)\right\}^T\cdot\vec
P_j\quad\quad\forall j=1,\ldots n.
\end{multline}
Next inserting
\er{vhfffngghkjgghfjjghghghSYSPNjljllkljjhkhkhklpl;lpoop} into
\er{vhfffngghkjgghfjjghghghhjghjgghkghggSYSPNkljljjmjkjkjjk} we
deuce:
\begin{equation}\label{vhfffngghkjgghfjjghghghhjghjgghkghggSYSPNkljljjkklmjkjkjjk}
\frac{\partial H_{\vec u}}{\partial
t}+\sum_{j=1}^{n}\left(\nabla_{\vec P_j}H_0\cdot\nabla_{\vec
r_j}H_{\vec u}-\nabla_{\vec r_j}H_0\cdot\nabla_{\vec P_j}H_{\vec
u}\right)=0.
\end{equation}
Thus inserting
\er{vhfffngghkjgghfjjghghghhjghjgghkghggSYSPNkljljjmjkjjjkkjlkjkljljjkjkkjjklkjjkjh}
into \er{vhfffngghkjgghfjjghghghhjghjgghkghggSYSPNkljljjkklmjkjkjjk}
we obtain,
\begin{multline}\label{vhfffngghkjgghfjjghghghhjghjgghkghggSYSPNkljljjkklmjkjkjjk,mmm,kl;k;lklk}
0=\frac{\partial H_{\vec u}}{\partial
t}+\sum_{j=1}^{n}\left(\left(\nabla_{\vec P_j}H_{\vec u}+\vec u(
\vec r_j,t)\right)\cdot\nabla_{\vec r_j}H_{\vec
u}-\left(\nabla_{\vec r_j}H_{\vec u}+\left\{d_{\vec x}\vec u(\vec
r_j,t)\right\}^T\cdot\vec P_j\right)\cdot\nabla_{\vec P_j}H_{\vec
u}\right)=\\
\frac{\partial H_{\vec u}}{\partial t}+\sum_{j=1}^{n}\vec u( \vec
r_j,t)\cdot\nabla_{\vec r_j}H_{\vec
u}-\sum_{j=1}^{n}\left(\left\{d_{\vec x}\vec u(\vec
r_j,t)\right\}^T\cdot\vec P_j\right)\cdot\nabla_{\vec P_j}H_{\vec u}
=\\
\frac{\partial H_{\vec u}}{\partial t}+\sum_{j=1}^{n}\vec u( \vec
r_j,t)\cdot\nabla_{\vec r_j}H_{\vec
u}+\sum_{j=1}^{n}\frac{1}{2}\left(\left(d_{\vec x}\vec u(\vec
r_j,t)-\left\{d_{\vec x}\vec u(\vec r_j,t)\right\}^T\right)\cdot\vec
P_j\right)\cdot\nabla_{\vec P_j}H_{\vec
u}\\-\sum_{j=1}^{n}\frac{1}{2}\left(\left(d_{\vec x}\vec u(\vec
r_j,t)+\left\{d_{\vec x}\vec u(\vec r_j,t)\right\}^T\right)\cdot\vec
P_j\right)\cdot\nabla_{\vec P_j}H_{\vec u}.
\end{multline}
Thus, by \er{apfrm9}, we rewrite
\er{vhfffngghkjgghfjjghghghhjghjgghkghggSYSPNkljljjkklmjkjkjjk,mmm,kl;k;lklk}
as:
\begin{multline}\label{vhfffngghkjgghfjjghghghhjghjgghkghggSYSPNkljljjkklmjkjkjjk,mmm,kl;k;lklkaa}
\frac{\partial H_{\vec u}}{\partial t}+\sum_{j=1}^{n}\vec u( \vec
r_j,t)\cdot\nabla_{\vec r_j}H_{\vec
u}+\sum_{j=1}^{n}\frac{1}{2}\left(\left(curl_{\vec x}\vec u(\vec
r_j,t)\right)\times\vec P_j\right)\cdot\nabla_{\vec P_j}H_{\vec
u}\\-\sum_{j=1}^{n}\frac{1}{2}\left(\left(d_{\vec x}\vec u(\vec
r_j,t)+\left\{d_{\vec x}\vec u(\vec r_j,t)\right\}^T\right)\cdot\vec
P_j\right)\cdot\nabla_{\vec P_j}H_{\vec u}=0.
\end{multline}
Equality
\er{vhfffngghkjgghfjjghghghhjghjgghkghggSYSPNkljljjkklmjkjkjjk,mmm,kl;k;lklkaa}
is the required restriction on the field $\vec u$ and the
Hamiltonian $H_{0}$, providing that our system can indeed be found
in state of thermodynamical equilibrium. In particular, if
$\frac{\partial H_0}{\partial t}=0$ and $\vec u=0$ then
\er{vhfffngghkjgghfjjghghghhjghjgghkghggSYSPNkljljjkklmjkjkjjk,mmm,kl;k;lklkaa}
indeed holds. Moreover, as before in the case of Liouville's
equation, we deduce that the equation
\er{vhfffngghkjgghfjjghghghhjghjgghkghggSYSPNkljljjkklmjkjkjjk,mmm,kl;k;lklkaa}
is invariant under the change of non-inertial cartesian coordinate
system of the form \er{noninchgravortbstrjgghguittu2jjjk}, provided
that we have
\er{yuythfgfyftydtydtydtyddyyyhhddhhhredPPN111hgghjgintintrrZZHHlllkljjjk},
\er{noninchgravortbstrjgghguittu2intrrrZZHHjkjhjh} and
\er{yuythfgfyftydtydtydtyddyyyhhddhhhredPPN111hgghjgintintrrZZHHlll}.

Next, we still need to prove that if $w$ is given by
\er{vhfffngghkjgghfjjghghghSYSPNjljllkljjhkhkhklpl;lpoop} then the
vector field $\vec u(\vec x,t)$ is indeed the macroscopic (average)
velocity field of every particle that we can found near the point
$\vec x$ at the instant of time $t$. Indeed, we need to prove the
following:
\begin{equation}\label{guytyurtydftyiujhioyhuyhSYSPNlkjljklkl}
\vec u(\vec x,t):=\frac{1}{\mu_j(\vec x,t)}\vec p_j(\vec x,t)+\vec
v(\vec x,t)-\frac{\sigma_j}{cm_j}\vec A(\vec x,t)\quad\quad\forall
j\in\{1,\ldots n\},
\end{equation}
where
\begin{equation}\label{vhfffngghkjgghfjjghghghhjghjgghkghggSYSPNkljljjmjkjjjkkjlkjkllkljkmjhhhkghggklklnnmmmhjjh}
\mu_j(\vec x,t):=\int m_j \,w\left(\vec P_1,\ldots,\vec P_n,\vec
r_1,\ldots,\vec r_n,t\right)\delta(\vec x-\vec r_j)\,d \vec
P_1,\ldots,d\vec P_n,d\vec r_1,\ldots,d\vec r_n,
\end{equation}
and
\begin{equation}\label{vhfffngghkjgghfjjghghghhjghjgghkghggSYSPNkljljjmjkjjjkkjlkjkllkljkmjhhhkghggklklnnmmmjhkhk}
\vec p_j(\vec x,t):=\int \vec P_j \,w\left(\vec P_1,\ldots,\vec
P_n,\vec r_1,\ldots,\vec r_n,t\right)\delta(\vec x-\vec r_j)\,d \vec
P_1,\ldots,d\vec P_n,d\vec r_1,\ldots,d\vec r_n.
\end{equation}
On the other hand, by \er{vhfffngghkjgghfjjghghghSYSPNjljll'l} we
have:
\begin{multline}\label{vhfffngghkjgghfjjghghghSYSPNjljll'lklkjjjkjjkj}
H_{\vec u}\left(\vec P_1,\ldots,\vec P_n,\vec r_1,\ldots,\vec
r_n,t\right)=
\sum_{j=1}^{n}\frac{1}{2m_j}\left|\vec P_j-\frac{\sigma_j}{c}\vec
A(\vec r_j,t)\right|^2\\+\sum_{j=1}^{n}\left(\vec
P_j-\frac{\sigma_j}{c}\vec A(\vec r_j,t)\right)\cdot\left(\vec
v(\vec r_j,t)-\vec u(\vec
r_j,t)\right)+\sum_{j=1}^{n}\frac{\sigma_j}{c}\vec A(\vec
r_j,t)\cdot\left(\vec v(\vec r_j,t)-\vec u(\vec r_j,t)\right)\\+
\sum_{j=1}^{n}\sigma_j\left(\Psi(\vec r_j,t)-\frac{1}{c}\vec A(\vec
r_j,t)\cdot\vec v(\vec r_j,t)\right)-V\left(\vec r_1,\ldots,\vec
r_n,t\right)\\
=
\sum_{j=1}^{n}\frac{1}{2m_j}\left|\vec P_j+m_j\vec v(\vec
r_j,t)-\frac{\sigma_j}{c}\vec A(\vec r_j,t)-m_j\vec u(\vec
r_j,t)\right|^2-\sum_{j=1}^{n}\frac{m_j}{2}\left|\vec v(\vec
r_j,t)-\vec u(\vec
r_j,t)\right|^2\\+\sum_{j=1}^{n}\frac{\sigma_j}{c}\vec A(\vec
r_j,t)\cdot\left(\vec v(\vec r_j,t)-\vec u(\vec r_j,t)\right)+
\sum_{j=1}^{n}\sigma_j\left(\Psi(\vec r_j,t)-\frac{1}{c}\vec A(\vec
r_j,t)\cdot\vec v(\vec r_j,t)\right)-V\left(\vec r_1,\ldots,\vec
r_n,t\right).
\end{multline}
Thus by inserting
\er{vhfffngghkjgghfjjghghghSYSPNjljll'lklkjjjkjjkj} into
\er{vhfffngghkjgghfjjghghghSYSPNjljllkljjhkhkhklpl;lpoop} we deduce
that \er{guytyurtydftyiujhioyhuyhSYSPNlkjljklkl} indeed holds.

Finally, by \er{vhfffngghkjgghfjjghghghSYSPNjljll'l} we have,
\begin{multline}\label{vhfffngghkjgghfjjghghghSYSPNjljll'lhkhh}
H_{\vec u}\left(\vec P_1,\ldots,\vec P_n,\vec r_1,\ldots,\vec
r_n,t\right)=
\sum_{j=1}^{n}\frac{1}{2m_j}\left|\vec
P_j\right|^2-\sum_{j=1}^{n}\vec P_j\cdot\left(\vec u(\vec
r_j,t)-\vec v(\vec r_j,t)+\frac{\sigma_j}{m_jc}\vec A(\vec
r_j,t)\right)\\+
\sum_{j=1}^{n}\left(\frac{\sigma^2_j}{2m_jc^2}\left|\vec A(\vec
r_j,t)\right|^2+\sigma_j\Psi(\vec r_j,t)-\frac{\sigma_j}{c}\vec
A(\vec r_j,t)\cdot\vec v(\vec r_j,t)\right)-V\left(\vec
r_1,\ldots,\vec r_n,t\right),
\end{multline}
that we rewrite as
\begin{equation}\label{vhfffngghkjgghfjjghghghSYSPNjljll'lhkhhhhh}
H_{\vec u}\left(\vec P_1,\ldots,\vec P_n,\vec r_1,\ldots,\vec
r_n,t\right)=
\sum_{j=1}^{n}\frac{1}{2m_j}\left|\vec
P_j\right|^2-\sum_{j=1}^{n}\vec P_j\cdot\vec h_j(\vec r_j,t)+
U\left(\vec r_1,\ldots,\vec r_n,t\right),
\end{equation}
where by $\vec h_j(\vec x,t)$ we denote a proper vector field,
defined as
\begin{equation}\label{vhfffngghkjgghfjjghghghhjghjgghkghggkghghjghSYSPNjjjhhhjjkjkjklkjljljkkhhkjkjkjkjjkjk}
\vec h_j(\vec x,t):= \left(\vec u(\vec x,t)-\vec v(\vec
x,t)+\frac{\sigma_j}{m_jc}\vec A(\vec x,t)\right),
\end{equation}
and by $U$ we denote a proper scalar, defined as
\begin{multline}\label{vhfffngghkjgghfjjghghghSYSPNjljll'lhkhhjgggjgjgghg}
U\left(\vec x_1,\ldots,\vec x_n,t\right):=\\
\sum_{j=1}^{n}\left(\frac{\sigma^2_j}{2m_jc^2}\left|\vec A(\vec
x_j,t)\right|^2+\sigma_j\Psi(\vec x_j,t)-\frac{\sigma_j}{c}\vec
A(\vec x_j,t)\cdot\vec v(\vec x_j,t)\right)-V\left(\vec
x_1,\ldots,\vec x_n,t\right).
\end{multline}

Thus,
\begin{equation}\label{vhfffngghkjgghfjjghghghSYSPNjljll'lhkhhhhhlll;l;iojj}
\frac{\partial H_{\vec u}}{\partial t}=
-\sum_{j=1}^{n}\vec P_j\cdot\frac{\partial \vec h_j}{\partial t}\vec
(\vec r_j,t)+ \frac{\partial U}{\partial t}\left(\vec
r_1,\ldots,\vec r_n,t\right),
\end{equation}
\begin{equation}\label{vhfffngghkjgghfjjghghghSYSPNjljll'lhkhhhhhll;l;uhhihj}
\nabla_{\vec r_j}H_{\vec u}=
-\left\{d_{\vec x}\vec h_j(\vec r_j,t)\right\}^T\cdot\vec P_j+
\nabla_{\vec r_j}U\left(\vec r_1,\ldots,\vec r_n,t\right),
\end{equation}
and
\begin{equation}\label{vhfffngghkjgghfjjghghghSYSPNjljll'lhkhhhhhl;;kkkk;k}
\nabla_{\vec P_j}H_{\vec u}=
\frac{1}{m_j}\vec P_j-\vec h_j(\vec r_j,t).
\end{equation}
On the other hand, by
\er{vhfffngghkjgghfjjghghghhjghjgghkghggSYSPNkljljjkklmjkjkjjk,mmm,kl;k;lklk}
we have
\begin{equation}\label{vhfffngghkjgghfjjghghghhjghjgghkghggSYSPNkljljjkklmjkjkjjk,mmm,kl;k;lklkjkjkk1}
\frac{\partial H_{\vec u}}{\partial t}+\sum_{j=1}^{n}\vec u( \vec
r_j,t)\cdot\nabla_{\vec r_j}H_{\vec
u}-\sum_{j=1}^{n}\left(\left\{d_{\vec x}\vec u(\vec
r_j,t)\right\}^T\cdot\vec P_j\right)\cdot\nabla_{\vec P_j}H_{\vec
u}=0.
\end{equation}
Thus inserting
\er{vhfffngghkjgghfjjghghghSYSPNjljll'lhkhhhhhlll;l;iojj},
\er{vhfffngghkjgghfjjghghghSYSPNjljll'lhkhhhhhll;l;uhhihj} and
\er{vhfffngghkjgghfjjghghghSYSPNjljll'lhkhhhhhl;;kkkk;k} into
\er{vhfffngghkjgghfjjghghghhjghjgghkghggSYSPNkljljjkklmjkjkjjk,mmm,kl;k;lklkjkjkk1}
gives
\begin{multline}\label{vhfffngghkjgghfjjghghghhjghjgghkghggSYSPNkljljjkklmjkjkjjk,mmm,kl;k;lklkjkjkkjkjkjkjlj}
-\sum_{j=1}^{n}\vec P_j\cdot\frac{\partial \vec h_j}{\partial t}\vec
(\vec r_j,t)+ \frac{\partial U}{\partial t}\left(\vec
r_1,\ldots,\vec r_n,t\right)+\sum_{j=1}^{n}\vec u( \vec
r_j,t)\cdot\left(-\left\{d_{\vec x}\vec h_j(\vec
r_j,t)\right\}^T\cdot\vec P_j+ \nabla_{\vec r_j}U\left(\vec
r_1,\ldots,\vec
r_n,t\right)\right)\\-\sum_{j=1}^{n}\left(\left\{d_{\vec x}\vec
u(\vec r_j,t)\right\}^T\cdot\vec
P_j\right)\cdot\left(\frac{1}{m_j}\vec P_j-\vec h_j(\vec
r_j,t)\right)=0.
\end{multline}
I.e.
\begin{multline}\label{vhfffngghkjgghfjjghghghhjghjgghkghggSYSPNkljljjkklmjkjkjjk,mmm,kl;k;lklkjkjkkjkjkjkjljljjljkl}
\left(\frac{\partial U}{\partial t}\left(\vec r_1,\ldots,\vec
r_n,t\right)+\sum_{j=1}^{n}\vec u( \vec r_j,t)\cdot\nabla_{\vec
r_j}U\left(\vec r_1,\ldots,\vec
r_n,t\right)\right)\\-\sum_{j=1}^{n}\vec
P_j\cdot\left(\frac{\partial \vec h_j}{\partial t}\vec (\vec
r_j,t)+d_{\vec x}\vec h_j(\vec r_j,t)\cdot\vec u( \vec
r_j,t)-d_{\vec x}\vec u(\vec r_j,t)\cdot\vec h_j( \vec
r_j,t)\right)\\-\sum_{j=1}^{n}\frac{1}{2m_j}\left(\left(d_{\vec
x}\vec u(\vec r_j,t)+\left\{d_{\vec x}\vec u(\vec
r_j,t)\right\}^T\right)\cdot\vec P_j\right)\cdot\vec P_j=0.
\end{multline}
Thus by \er{apfrm6} we rewrite
\er{vhfffngghkjgghfjjghghghhjghjgghkghggSYSPNkljljjkklmjkjkjjk,mmm,kl;k;lklkjkjkkjkjkjkjljljjljkl}
as
\begin{multline}\label{vhfffngghkjgghfjjghghghhjghjgghkghggSYSPNkljljjkklmjkjkjjk,mmm,kl;k;lklkjkjkkjkjkjkjljljjljklggghgkhhkhkhk}
\left(\frac{\partial U}{\partial t}\left(\vec r_1,\ldots,\vec
r_n,t\right)+\sum_{j=1}^{n}\vec u( \vec r_j,t)\cdot\nabla_{\vec
r_j}U\left(\vec r_1,\ldots,\vec
r_n,t\right)\right)\\-\sum_{j=1}^{n}\vec
P_j\cdot\left(\frac{\partial \vec h_j}{\partial t}\vec (\vec
r_j,t)-curl_{\vec x}\left(\vec u(\vec r_j,t)\times\vec h_j(\vec
r_j,t)\right)+\left(\Div_{\vec x}\vec h_j(\vec r_j,t)\right)\vec
u(\vec r_j,t)-\left(\Div_{\vec x}\vec u(\vec r_j,t)\right)\vec
h_j(\vec
r_j,t)\right)\\-\sum_{j=1}^{n}\frac{1}{2m_j}\left(\left(d_{\vec
x}\vec u(\vec r_j,t)+\left\{d_{\vec x}\vec u(\vec
r_j,t)\right\}^T\right)\cdot\vec P_j\right)\cdot\vec P_j=0,
\end{multline}
that is equivalent to the following
\begin{equation}\label{vhfffngghkjgghfjjghghghhjghjgghkghggSYSPNkljljjkklmjkjkjjk,mmm,kl;k;lklkjkjkkjkjkjkjljljjljklggghgkhhkhkhkkklk}
\begin{cases}
\frac{\partial U}{\partial t}\left(\vec x_1,\ldots,\vec
x_n,t\right)+\sum_{j=1}^{n}\vec u( \vec x_j,t)\cdot\nabla_{\vec
x_j}U\left(\vec x_1,\ldots,\vec x_n,t\right)=0\\
\frac{\partial \vec h_j}{\partial t}\vec (\vec x,t)-curl_{\vec
x}\left(\vec u(\vec x,t)\times\vec h_j(\vec
x,t)\right)+\left(\Div_{\vec x}\vec h_j(\vec x,t)\right)\vec u(\vec
x,t)=0\quad\quad\quad\forall j=1,\ldots,n\\d_{\vec x}\vec u(\vec
x,t)+\left\{d_{\vec x}\vec u(\vec x,t)\right\}^T=0.
\end{cases}
\end{equation}
Next, using
\er{vhfffngghkjgghfjjghghghhjghjgghkghggkghghjghSYSPNjjjhhhjjkjkjklkjljljkkhhkjkjkjkjjkjk}
we rewrite
\er{vhfffngghkjgghfjjghghghhjghjgghkghggSYSPNkljljjkklmjkjkjjk,mmm,kl;k;lklkjkjkkjkjkjkjljljjljklggghgkhhkhkhkkklk}
as:
\begin{equation}\label{vhfffngghkjgghfjjghghghhjghjgghkghggSYSPNkljljjkklmjkjkjjk,mmm,kl;k;lklkjkjkkjkjkjkjljljjljklggghgkhhkhkhkkklkklklkl}
\begin{cases}
\frac{\partial U}{\partial t}\left(\vec x_1,\ldots,\vec
x_n,t\right)+\sum_{j=1}^{n}\vec u( \vec x_j,t)\cdot\nabla_{\vec
x_j}U\left(\vec x_1,\ldots,\vec x_n,t\right)=0,\\
\frac{\partial}{\partial t}\left(\vec u(\vec x,t)-\vec v(\vec
x,t)+\frac{\sigma_j}{m_jc}\vec A(\vec x,t)\right)-curl_{\vec
x}\left(\vec u(\vec x,t)\times\left(\vec u(\vec x,t)-\vec v(\vec
x,t)+\frac{\sigma_j}{m_jc}\vec A(\vec
x,t)\right)\right)\\+\left(\Div_{\vec x}\left(\vec u(\vec x,t)-\vec
v(\vec x,t)+\frac{\sigma_j}{m_jc}\vec A(\vec x,t)\right)\right)\vec
u(\vec x,t)=0\quad\quad\quad\quad\forall j=1,\ldots,n\,\\d_{\vec
x}\vec u(\vec x,t)+\left\{d_{\vec x}\vec u(\vec x,t)\right\}^T=0.
\end{cases}
\end{equation}
Thus since the second equation in
\er{vhfffngghkjgghfjjghghghhjghjgghkghggSYSPNkljljjkklmjkjkjjk,mmm,kl;k;lklkjkjkkjkjkjkjljljjljklggghgkhhkhkhkkklkklklkl}
must be valid for every $j=1,\ldots,n$ by
\er{vhfffngghkjgghfjjghghghhjghjgghkghggSYSPNkljljjkklmjkjkjjk,mmm,kl;k;lklkjkjkkjkjkjkjljljjljklggghgkhhkhkhkkklkklklkl}
we deduce
\begin{equation}\label{vhfffngghkjgghfjjghghghhjghjgghkghggSYSPNkljljjkklmjkjkjjk,mmm,kl;k;lklkjkjkkjkjkjkjljljjljklggghgkhhkhkhkkklkklklkljkhkhjjkhjh}
\begin{cases}
\frac{\partial U}{\partial t}\left(\vec x_1,\ldots,\vec
x_n,t\right)+\sum_{j=1}^{n}\vec u( \vec x_j,t)\cdot\nabla_{\vec
x_j}U\left(\vec x_1,\ldots,\vec x_n,t\right)=0,\\
\frac{\partial}{\partial t}\left(\vec u(\vec x,t)-\vec v(\vec
x,t)\right)-curl_{\vec x}\left(\vec u(\vec x,t)\times\left(\vec
u(\vec x,t)-\vec v(\vec x,t)\right)\right)+\left(\Div_{\vec
x}\left(\vec u(\vec x,t)-\vec v(\vec x,t)\right)\right)\vec
u(\vec x,t)=0,\\
\frac{\partial\vec A}{\partial t}(\vec x,t)-curl_{\vec x}\left(\vec
u(\vec x,t)\times\vec A(\vec x,t)\right)+\left(\Div_{\vec x}\vec
A(\vec x,t)\right)\vec u(\vec x,t)=0,
\\
d_{\vec x}\vec u(\vec x,t)+\left\{d_{\vec x}\vec u(\vec
x,t)\right\}^T=0,
\end{cases}
\end{equation}
where
\begin{multline}\label{vhfffngghkjgghfjjghghghSYSPNjljll'lhkhhjgggjgjgghgjkjjk}
U\left(\vec x_1,\ldots,\vec x_n,t\right):=\\
\sum_{j=1}^{n}\left(\frac{\sigma^2_j}{2m_jc^2}\left|\vec A(\vec
x_j,t)\right|^2+\sigma_j\Psi(\vec x_j,t)-\frac{\sigma_j}{c}\vec
A(\vec x_j,t)\cdot\vec v(\vec x_j,t)\right)-V\left(\vec
x_1,\ldots,\vec x_n,t\right).
\end{multline}
Then, clearly equations
\er{vhfffngghkjgghfjjghghghhjghjgghkghggSYSPNkljljjkklmjkjkjjk,mmm,kl;k;lklkjkjkkjkjkjkjljljjljklggghgkhhkhkhkkklkklklkl}
and
\er{vhfffngghkjgghfjjghghghhjghjgghkghggSYSPNkljljjkklmjkjkjjk,mmm,kl;k;lklkjkjkkjkjkjkjljljjljklggghgkhhkhkhkkklkklklkljkhkhjjkhjh},
\er{vhfffngghkjgghfjjghghghSYSPNjljll'lhkhhjgggjgjgghgjkjjk} are
invariant under the change of inertial or non-inertial cartesian
coordinate systems. Moreover,
\er{vhfffngghkjgghfjjghghghhjghjgghkghggSYSPNkljljjkklmjkjkjjk,mmm,kl;k;lklkjkjkkjkjkjkjljljjljklggghgkhhkhkhkkklkklklkljkhkhjjkhjh},
\er{vhfffngghkjgghfjjghghghSYSPNjljll'lhkhhjgggjgjgghgjkjjk} are
equivalent to
\er{vhfffngghkjgghfjjghghghhjghjgghkghggSYSPNkljljjkklmjkjkjjk,mmm,kl;k;lklk}
and then represent the necessary conditions on the field $\vec u$
and the system for the existence of a possibility to achieve the
Thermodynamical equilibrium in the given system.

Next note that by
\er{vhfffngghkjgghfjjghghghhjghjgghkghggSYSPNkljljjkklmjkjkjjk,mmm,kl;k;lklkjkjkkjkjkjkjljljjljklggghgkhhkhkhkkklkklklkljkhkhjjkhjh}
and Proposition \ref{jhjkhhjhjhgjgjjkjk} there exists another
cartesian coordinate system $(**)$ such that under the change of
coordinate system $(*)$ to another cartesian coordinate system
$(**)$, given by \er{noninchgravortbstrjgghguittu2hhhhh}, we have
\begin{equation}
\label{jljjjjkhhhjkkbb}
A(t)\cdot \vec u(\vec x,t)+
\frac{d A}{dt}(t)\cdot\vec x+
\frac{d\vec z}{dt}(t)=\vec u'(\vec x',t')=0.
\end{equation}
Thus, since
\er{vhfffngghkjgghfjjghghghhjghjgghkghggSYSPNkljljjkklmjkjkjjk,mmm,kl;k;lklkjkjkkjkjkjkjljljjljklggghgkhhkhkhkkklkklklkljkhkhjjkhjh},
\er{vhfffngghkjgghfjjghghghSYSPNjljll'lhkhhjgggjgjgghgjkjjk} are
invariant under the change of inertial or non-inertial cartesian
coordinate systems, in system $(**)$ we have
\begin{equation}\label{vhfffngghkjgghfjjghghghhjghjgghkghggSYSPNkljljjkklmjkjkjjk,mmm,kl;k;lklkjkjkkjkjkjkjljljjljklggghgkhhkhkhkkklkklklkljkhkhjjkhjhmjmm,mjhjkjk}
\begin{cases}
\frac{\partial U'}{\partial t'}\left(\vec x'_1,\ldots,\vec
x'_n,t'\right)=0,\\
\frac{\partial\vec v'}{\partial t'}(\vec
x',t')=0,\\
\frac{\partial\vec A'}{\partial t'}(\vec x',t')=0
\\
\vec u'(\vec x',t')=0,
\end{cases}
\end{equation}
where
\begin{multline}\label{vhfffngghkjgghfjjghghghSYSPNjljll'lhkhhjgggjgjgghgjkjjkkj}
U'\left(\vec x'_1,\ldots,\vec x'_n,t'\right):=\\
\sum_{j=1}^{n}\left(\frac{(\sigma'_j)^2}{2m'_jc^2}\left|\vec A'(\vec
x'_j,t')\right|^2+\sigma'_j\Psi'(\vec
x'_j,t')-\frac{\sigma'_j}{c}\vec A'(\vec x'_j,t')\cdot\vec v'(\vec
x'_j,t')\right)-V'\left(\vec x'_1,\ldots,\vec x'_n,t'\right).
\end{multline}
On the other hand,
\er{vhfffngghkjgghfjjghghghhjghjgghkghggSYSPNkljljjkklmjkjkjjk,mmm,kl;k;lklkjkjkkjkjkjkjljljjljklggghgkhhkhkhkkklkklklkljkhkhjjkhjhmjmm,mjhjkjk}
is equivalent to
\begin{equation}\label{vhfffnhhjhjjkkjhk}
\frac{\partial H'_{0}}{\partial t'}\left(\vec P'_1,\ldots,\vec
P'_n,\vec r'_1,\ldots,\vec r'_n,t'\right)=0\quad\text{and}\quad\vec
u'(\vec x',t')=0.
\end{equation}
Thus we obtain that the condition for existing the possibility to
achieve the Thermodynamical equilibrium in the given system, ruled
by the Hamiltonian $H_0\left(\vec P_1,\ldots,\vec P_n,\vec
r_1,\ldots,\vec r_n,t\right)$, of the form
\er{vhfffngghkjgghfjjghghghhjghjgghkghggSYSPNkljljjkklmjkjkjjk,mmm,kl;k;lklkjkjkk1},
is equivalent to the existence of a cartesian coordinate system
$(**)$, where the average velocity of every particle vanishes and at
the same time in the system $(**)$ the Hamiltonian is independent of
the time variable explicitly.

%
%
%
%
%
%

\subsection{Shr\"{o}dinger equation for a finite system of quantum
particles}\label{hggyugyuy1} Consider the motion of a system of $n$
quantum micro-particles with inertial masses $m_1,\ldots,m_n$ and
the charges $\sigma_1,\ldots,\sigma_n$ in the outer gravitational
and electromagnetical field with characteristics $\vec v(\vec x,t)$,
$\vec A(\vec x,t)$ and $\Psi(\vec x,t)$ and additional conservative
field with potential $V(\vec y_1,\ldots,\vec y_n,t)$, not taking
into account the spin interaction. The Shr\"{o}dinger equation for
this system of particles is
\begin{equation}\label{vhfffngghkjgghfjjghghghhjghjgghkghggkghghjghSYSPN}
i\hbar\frac{\partial\psi}{\partial t}=\hat H_0\cdot\psi,
\end{equation}
where $\psi(\vec x_1,\ldots,\vec x_n,t)\in\mathbb{C}$ is a wave
function and $\hat H_0$ is the Hamiltonian operator. Since by
\er{vhfffngghkjgghfjjghghghSYSPN} the Hamiltonian for a
macro-particles has the form
\begin{multline}\label{vhfffngghkjgghfjjghghghhjghjgghkghggSYSPN}
H_{\text{macro}}\left(\vec P_1,\ldots,\vec P_n,\vec r_1,\ldots,\vec
r_n,t\right)=-V\left(\vec r_1,\ldots,\vec r_n,t\right)+
\sum_{j=1}^{n}\frac{1}{2}\vec P_j\cdot\vec v(\vec
r_j,t)+\sum_{j=1}^{n}\frac{1}{2}\vec v(\vec r_j,t)\cdot\vec P_j\\+
\sum_{j=1}^{n}\frac{1}{2m_j}\left|\vec P_j-\frac{\sigma_j}{c}\vec
A(\vec r_j,t)\right|^2+\sum_{j=1}^{n}\sigma_j\left(\Psi(\vec
r_j,t)-\frac{1}{c}\vec A(\vec r_j,t)\cdot\vec v(\vec r_j,t)\right),
\end{multline}
we built the Hermitian Hamiltonian operator as
\begin{multline}\label{vhfffngghkjgghfjjghghghhjghjgghkghgghjhggjjkgSYSPN}
\hat H_0\cdot\psi=-\sum_{j=1}^{n}\frac{i\hbar}{2}div_{\vec
x_j}\left\{\psi\vec v(\vec
x_j,t)\right\}-\sum_{j=1}^{n}\frac{i\hbar}{2}\vec v(\vec
x_j,t)\cdot\nabla_{\vec x_j}\psi\\+
\sum_{j=1}^{n}\left\{\frac{1}{2m_j}\left(-i\hbar\nabla_{\vec
x_j}-\frac{\sigma_j}{c}\vec A(\vec
x_j,t)\right)\circ\left(-i\hbar\nabla_{\vec
x_j}-\frac{\sigma_j}{c}\vec A(\vec
x_j,t)\right)\right\}\cdot\psi\\+\sum_{j=1}^{n}\sigma_j\left(\Psi(\vec
x_j,t)-\frac{1}{c}\vec A(\vec x_j,t)\cdot\vec v(\vec
x_j,t)\right)\cdot\psi-V\left(\vec x_1,\ldots,\vec
x_n,t\right)\cdot\psi=
-\sum_{j=1}^{n}\frac{\hbar^2}{2m_j}\Delta_{\vec
x_j}\psi-\sum_{j=1}^{n}\frac{i\hbar}{2}div_{\vec x_j}\left\{\psi\vec
v(\vec x_j,t)\right\}\\-\sum_{j=1}^{n}\frac{i\hbar}{2}\vec v(\vec
x_j,t)\cdot\nabla_{\vec x_j}\psi+\sum_{j=1}^{n}\frac{
i\hbar\sigma_j}{2m_jc}div_{\vec x_j}\left\{\psi\vec A(\vec
x_j,t)\right\}+\sum_{j=1}^{n}\frac{ i\hbar\sigma_j}{2m_jc}\vec
A(\vec x_j,t)\cdot\nabla_{\vec
x_j}\psi\\+\sum_{j=1}^{n}\left(\sigma_j\Psi(\vec
x_j,t)-\frac{\sigma_j}{c}\vec A(\vec x_j,t)\cdot\vec v(\vec
x_j,t)+\frac{\sigma^2_j}{2m_jc^2}\left|\vec A(\vec
x_j,t)\right|^2\right)\psi-V\left(\vec x_1,\ldots,\vec
x_n,t\right)\psi,
\end{multline}
Thus the corresponding Shr\"{o}dinger equation will be
\begin{multline}\label{vhfffngghkjgghfjjghghghhjghjgghkghgghjhggjjkgfgdSYSPN}
i\hbar\frac{\partial\psi}{\partial t}=\hat H_0\cdot\psi=
-\sum_{j=1}^{n}\frac{\hbar^2}{2m_j}\Delta_{\vec
x_j}\psi-\sum_{j=1}^{n}\frac{i\hbar}{2}div_{\vec x_j}\left\{\psi\vec
v(\vec x_j,t)\right\}-\sum_{j=1}^{n}\frac{i\hbar}{2}\vec v(\vec
x_j,t)\cdot\nabla_{\vec x_j}\psi\\+\sum_{j=1}^{n}\frac{
i\hbar\sigma_j}{2m_jc}div_{\vec x_j}\left\{\psi\vec A(\vec
x_j,t)\right\}+\sum_{j=1}^{n}\frac{ i\hbar\sigma_j}{2m_jc}\vec
A(\vec x_j,t)\cdot\nabla_{\vec
x_j}\psi\\+\sum_{j=1}^{n}\left(\sigma_j\Psi(\vec
x_j,t)-\frac{\sigma_j}{c}\vec A(\vec x_j,t)\cdot\vec v(\vec
x_j,t)+\frac{\sigma^2_j}{2m_jc^2}\left|\vec A(\vec
x_j,t)\right|^2\right)\psi-V\left(\vec x_1,\ldots,\vec
x_n,t\right)\psi.
\end{multline}
So
%
%
%
%
%
%
\begin{multline}\label{vhfffngghkjgghfjjghghghhjghjgghkghgghjhggjjkgfgdiyfgfjkjgjgSYSPN}
i\hbar\left(\frac{\partial\psi}{\partial t}+\sum_{j=1}^{n}\vec
v(\vec x_j,t)\cdot\nabla_{\vec
x_j}\psi\right)+\sum_{j=1}^{n}\frac{i\hbar}{2}\left(div_{\vec
x_j}\vec v(\vec x_j,t)\right)\psi=
-\sum_{j=1}^{n}\frac{\hbar^2}{2m_j}\Delta_{\vec x_j}\psi-V\left(\vec
x_1,\ldots,\vec x_n,t\right)\psi\\+\sum_{j=1}^{n}\frac{
i\hbar\sigma_j}{2m_jc}div_{\vec x_j}\left\{\psi\vec A(\vec
x_j,t)\right\}+\sum_{j=1}^{n}\frac{ i\hbar\sigma_j}{2m_jc}\vec
A(\vec x_j,t)\cdot\nabla_{\vec
x_j}\psi\\+\sum_{j=1}^{n}\left(\sigma_j\Psi(\vec
x_j,t)-\frac{\sigma_j}{c}\vec A(\vec x_j,t)\cdot\vec v(\vec
x_j,t)+\frac{\sigma^2_j}{2m_jc^2}\left|\vec A(\vec
x_j,t)\right|^2\right)\psi.
\end{multline}
Next consider a change of some non-inertial cartesian coordinate
system $(*)$ to another cartesian coordinate system $(**)$ of the
form:
\begin{equation}\label{noninchgravortbstrghgggSYSPN}
\begin{cases}
\vec x'=A(t)\cdot\vec x+\vec z(t),\\
t'=t,
\end{cases}
\end{equation}
where $A(t)\in SO(3)$ is a rotation, i.e. $A(t)\in \R^{3\times 3}$,
$det\, A(t)>0$ and $A(t)\cdot A^T(t)=I$. Then, since
\begin{equation}\label{vyfgjhgjhvhgghSYSPN}
\begin{cases}
\psi'=\psi
\\
V'=V
\\
\vec A'=A(t)\cdot\vec A
\\
\Psi'-\frac{1}{c}\vec A'\cdot\vec v'=\Psi-\frac{1}{c}\vec A\cdot\vec
v,
\end{cases}
\end{equation}
we deduce that  the Shr\"{o}dinger equation of the form
\er{vhfffngghkjgghfjjghghghhjghjgghkghgghjhggjjkgfgdiyfgfjkjgjgSYSPN}
is invariant under the change of non-inertial cartesian coordinate
system. So the quantum mechanical laws are also invariant in every
non-inertial cartesian coordinate system.

Next, assume that in inertial coordinate system $(*)$ we have:
\begin{equation}
\label{MaxVacFull1ninshtrgravortghhghgjkgghklhjgkghghjjkjhjkkggjkhjkhjjhhfhjhklkhkhjjklzzzyyyhjggjhgghhjhNWNWBWHWNWSYSPN}
\begin{cases}
curl_{\vec x}\vec v= 0,\\
\frac{\partial\vec v}{\partial t}+d_\vec x\vec v\cdot\vec v=
-\nabla_{\vec x}\Phi,
\end{cases}
\end{equation}
%
%
%
%
%
%
where $\Phi$ is the scalar gravitational potential. Since in the
system $(*)$ we have $curl_{\vec x}\vec v=0$ we can rewrite
\er{MaxVacFull1ninshtrgravortghhghgjkgghklhjgkghghjjkjhjkkggjkhjkhjjhhfhjhklkhkhjjklzzzyyyhjggjhgghhjhNWNWBWHWNWSYSPN}
as
\begin{equation}
\label{MaxVacFull1ninshtrgravortghhghgjkgghklhjgkghghjjkjhjkkggjkhjkhjjhhfhjhklkhkhjjklzzzyyyhjggjhgghhjhNWNWNWNWNWBWHWNWSYSPN}
\begin{cases}
\vec v=\nabla_{\vec x}Z,\\
\frac{\partial Z}{\partial t}+\frac{1}{2}\left|\nabla_{\vec
x}Z\right|^2=-\Phi.
\end{cases}
\end{equation}
Thus by
\er{MaxVacFull1ninshtrgravortghhghgjkgghklhjgkghghjjkjhjkkggjkhjkhjjhhfhjhklkhkhjjklzzzyyyhjggjhgghhjhNWNWNWNWNWBWHWNWSYSPN}
we rewrite
\er{vhfffngghkjgghfjjghghghhjghjgghkghgghjhggjjkgfgdiyfgfjkjgjgSYSPN}
as
\begin{multline}\label{vhfffngghkjgghfjjghghghhjghjgghkghgghjhggjjkgfgdiyfgfjkjgjggjjgSYSPN}
i\hbar\frac{\partial\psi}{\partial
t}+\sum_{j=1}^{n}i\hbar\nabla_{\vec x_j}Z(\vec
x_j,t)\cdot\nabla_{\vec
x_j}\psi+\sum_{j=1}^{n}\frac{i\hbar}{2}\left(\Delta_{\vec x_j}Z(\vec
x_j,t)\right)\psi +\sum_{j=1}^{n}\frac{\hbar^2}{2m_j}\Delta_{\vec
x_j}\psi=\\
\sum_{j=1}^{n}\frac{ i\hbar\sigma_j}{2m_jc}\left(div_{\vec x_j}\vec
A(\vec x_j,t)\right)\psi
+\sum_{j=1}^{n}\frac{ i\hbar\sigma_j}{m_jc}\vec A(\vec
x_j,t)\cdot\nabla_{\vec
x_j}\psi-\sum_{j=1}^{n}\frac{\sigma_j}{c}\left(\vec A(\vec
x_j,t)\cdot\nabla_{\vec x_j}Z(\vec
x_j,t)\right)\psi\\+\sum_{j=1}^{n}\left(\sigma_j\Psi(\vec
x_j,t)+\frac{\sigma^2_j}{2m_jc^2}\left|\vec A(\vec
x_j,t)\right|^2\right)\psi-V\psi.
\end{multline}
Then multiplying
\er{vhfffngghkjgghfjjghghghhjghjgghkghgghjhggjjkgfgdiyfgfjkjgjggjjgSYSPN}
by factor $e^{\sum_{j=1}^{n}\frac{im_j}{\hbar}Z(\vec x_j,t)}$ gives:
\begin{multline}\label{vhfffngghkjgghfjjghghghhjghjgghkghgghjhggjjkgfgdiyfgfjkjgjggjjgugyyjSYSPN}
i\hbar\frac{\partial\psi}{\partial
t}e^{\sum_{j=1}^{n}\frac{im_j}{\hbar}Z(\vec
x_j,t)}+\sum_{k=1}^{n}i\hbar\left(\nabla_{\vec x_k}Z(\vec
x_k,t)\cdot\nabla_{\vec
x_k}\psi\right)e^{\sum_{j=1}^{n}\frac{im_j}{\hbar}Z(\vec
x_j,t)}\\+\sum_{k=1}^{n}\frac{i\hbar}{2}\left(\Delta_{\vec
x_k}Z(\vec x_k,t)\right)e^{\sum_{j=1}^{n}\frac{im_j}{\hbar}Z(\vec
x_j,t)}\psi
 +\sum_{k=1}^{n}\frac{\hbar^2}{2m_k}\left(\Delta_{\vec x_k}\psi\right)e^{\sum_{j=1}^{n}\frac{im_j}{\hbar}Z(\vec x_j,t)}=\\ \sum_{k=1}^{n}\frac{ i\hbar\sigma_k}{2m_kc}\left(div_{\vec x_k}\vec
A(\vec x_k,t)\right)e^{\sum_{j=1}^{n}\frac{im_j}{\hbar}Z(\vec
x_j,t)}\psi +
\sum_{k=1}^{n}\frac{ i\hbar\sigma_k}{m_kc}\left(\vec A(\vec
x_k,t)\cdot\nabla_{\vec
x_k}\psi\right)e^{\sum_{j=1}^{n}\frac{im_j}{\hbar}Z(\vec
x_j,t)}\\-\sum_{k=1}^{n}\frac{\sigma_k}{c}\left(\vec A(\vec
x_k,t)\cdot\nabla_{\vec x_k}Z(\vec
x_k,t)\right)e^{\sum_{j=1}^{n}\frac{im_j}{\hbar}Z(\vec
x_j,t)}\psi-V\left(e^{\sum_{j=1}^{n}\frac{im_j}{\hbar}Z(\vec
x_j,t)}\psi\right)\\+\sum_{k=1}^{n}\left(\sigma_k\Psi(\vec
x_k,t)+\frac{\sigma^2_k}{2m_kc^2}\left|\vec A(\vec
x_k,t)\right|^2\right)\left(e^{\sum_{j=1}^{n}\frac{im_j}{\hbar}Z(\vec
x_j,t)}\psi\right).
\end{multline}
We rewrite
\er{vhfffngghkjgghfjjghghghhjghjgghkghgghjhggjjkgfgdiyfgfjkjgjggjjgugyyjSYSPN}
as
\begin{multline}\label{vhfffngghkjgghfjjghghghhjghjgghkghgghjhggjjkgfgdiyfgfjkjgjggjjgugyyjjkgghgjSYSPN}
i\hbar\frac{\partial}{\partial
t}\left(e^{\sum_{j=1}^{n}\frac{im_j}{\hbar}Z(\vec
x_j,t)}\psi\right)+\sum_{k=1}^{n}\frac{\hbar^2}{2m_k}\Delta_{\vec
x_k}\left(e^{\sum_{j=1}^{n}\frac{im_j}{\hbar}Z(\vec
x_j,t)}\psi\right)=\\ \sum_{k=1}^{n}\frac{
i\hbar\sigma_k}{2m_kc}\left(div_{\vec x_k}\vec A(\vec
x_k,t)\right)e^{\sum_{j=1}^{n}\frac{im_j}{\hbar}Z(\vec x_j,t)}\psi+
\sum_{k=1}^{n}\frac{ i\hbar\sigma_k}{m_kc}\vec A(\vec
x_k,t)\cdot\nabla_{\vec
x_k}\left(e^{\sum_{j=1}^{n}\frac{im_j}{\hbar}Z(\vec
x_j,t)}\psi\right)
\\+\sum_{k=1}^{n}\left(\sigma_k\Psi(\vec
x_k,t)+\frac{\sigma^2_k}{2m_kc^2}\left|\vec A(\vec
x_k,t)\right|^2\right)\left(e^{\sum_{j=1}^{n}\frac{im_j}{\hbar}Z(\vec
x_j,t)}\psi\right)-V\left(e^{\sum_{j=1}^{n}\frac{im_j}{\hbar}Z(\vec
x_j,t)}\psi\right)\\-\sum_{k=1}^{n}m_k\left(\frac{\partial
Z}{\partial t}(\vec x_k,t)+\frac{1}{2}\left|\nabla_{\vec x_k}Z(\vec
x_k,t)\right|^2\right)\left(e^{\sum_{j=1}^{n}\frac{im_j}{\hbar}Z(\vec
x_j,t)}\psi\right).
\end{multline}
Therefore, inserting
\er{MaxVacFull1ninshtrgravortghhghgjkgghklhjgkghghjjkjhjkkggjkhjkhjjhhfhjhklkhkhjjklzzzyyyhjggjhgghhjhNWNWNWNWNWBWHWNWSYSPN}
into
\er{vhfffngghkjgghfjjghghghhjghjgghkghgghjhggjjkgfgdiyfgfjkjgjggjjgugyyjjkgghgjSYSPN}
gives
\begin{multline}\label{vhfffngghkjgghfjjghghghhjghjgghkghgghjhggjjkgfgdiyfgfjkjgjggjjgugyyjjkgghgjjhhjkkSYSPN}
i\hbar\frac{\partial}{\partial
t}\left(e^{\sum_{j=1}^{n}\frac{im_j}{\hbar}Z(\vec
x_j,t)}\psi\right)=-\sum_{k=1}^{n}\frac{\hbar^2}{2m_k}\Delta_{\vec
x}\left(e^{\sum_{j=1}^{n}\frac{im_j}{\hbar}Z(\vec
x_j,t)}\psi\right)-V\left(e^{\sum_{j=1}^{n}\frac{im_j}{\hbar}Z(\vec
x_j,t)}\psi\right)
\\+\sum_{k=1}^{n}\frac{ i\hbar\sigma_k}{2m_kc}\left(div_{\vec x_k}\vec
A(\vec x_k,t)\right)e^{\sum_{j=1}^{n}\frac{im_j}{\hbar}Z(\vec
x_j,t)}\psi +\sum_{k=1}^{n}\frac{ i\hbar\sigma_k}{m_kc}\vec A(\vec
x_k,t)\cdot\nabla_{\vec
x_k}\left(e^{\sum_{j=1}^{n}\frac{im_j}{\hbar}Z(\vec
x_j,t)}\psi\right)\\
+\sum_{k=1}^{n}\left(\sigma_k\Psi(\vec
x_k,t)+\frac{\sigma^2_k}{2m_kc^2}\left|\vec A(\vec
x_k,t)\right|^2+m_k\Phi(\vec
x_k,t)\right)\left(e^{\sum_{j=1}^{n}\frac{im_j}{\hbar}Z(\vec
x_j,t)}\psi\right).
\end{multline}
Then denoting
\begin{equation}\label{jhgghfhj}
\psi_1:=e^{\sum_{j=1}^{n}\frac{im_j}{\hbar}Z(\vec x_j,t)}\psi,
\end{equation}
we obtain in the
coordinate system $(*)$ the Shr\"{o}dinger equation in the form
\begin{multline}\label{vhfffngghkjgghfjjghghghhjghjgghkghgghjhggjjkgfgdiyfgfjkjgjggjjgugyyjjkgghgjjhhjkkhjhjkSYSPN}
i\hbar\frac{\partial\psi_1}{\partial
t}=-\sum_{j=1}^{n}\frac{\hbar^2}{2m_j}\Delta_{\vec
x_j}\psi_1+\sum_{j=1}^{n}\frac{ i\hbar\sigma_j}{2m_jc}div_{\vec
x_j}\left\{\psi_1\vec A(\vec x_j,t)\right\}
+\sum_{j=1}^{n}\frac{ i\hbar\sigma_j}{2m_jc}\vec A(\vec
x_j,t)\cdot\nabla_{\vec x_j}\psi_1
\\+\sum_{j=1}^{n}\left(\sigma_j\Psi(\vec
x_j,t)+\frac{\sigma^2_j}{2m_jc^2}\left|\vec A(\vec
x_j,t)\right|^2+m_j\Phi(\vec x_j,t)\right)\psi_1-V\psi_1,
\end{multline}
which coincides with the classical Shr\"{o}dinger equation for this
case.
\begin{remark}\label{gyytghffg}
Note that by \er{noninchgravortbstrjgghguittu1intmmjhhj} in Remark
\ref{ugyugg}, equality \er{jhgghfhj} implies that under the change
of coordinate system given by the Galilean Transformation
\begin{equation}\label{noninchgravortbstrjgghguittu1intmmkuk}
\begin{cases}
\vec x'=\vec x+\vec wt,\\
t'=t,
\end{cases}
\end{equation}
the quantity $\psi_1$ transforms as:
\begin{equation}\label{jhgghfhjhjhjhyuyuy}
\psi'_1:=e^{\sum_{j=1}^{n}\frac{im_j}{\hbar}(\vec w\cdot\vec
x_j+\frac{1}{2}|\vec w|^2t)}\psi_1,
\end{equation}
provided that $\psi'=\psi$. Moreover, \er{jhgghfhjhjhjhyuyuy}
coincides with the classical law of transformation of the wave
function, under the Galilean Transformation (see section 17 in
\cite{LL}, the end of the section).
\end{remark}

Next, again consider the motion and interaction of system of $n$
quantum micro-particles having inertial masses $m_1,\ldots, m_n$ and
the charges $\sigma_1,\ldots,\sigma_n$ with the known gravitational
and electromagnetical field with potentials $\vec v(\vec x,t)$,
$\vec A(\vec x,t)$ and $\Psi(\vec x,t)$ and additional conservative
field with potential $V(\vec y_1,\ldots,\vec y_n,t)$. Then consider
a Lagrangian density $L$ defined by
\begin{multline}\label{vhfffngghkjgghDDmmkkksmfggffgZZ}
L_2\left(\psi,\vec A,\Psi,\vec v,\vec x_1,\ldots,\vec x_n,t\right):=\\
\frac{i\hbar}{2}\left(\left(\frac{\partial\psi}{\partial
t}+\sum\limits_{k=1}^{n}\vec v(\vec x_k,t)\cdot\nabla_{\vec
x_k}\psi\right)\cdot\bar\psi-\psi\cdot\left(\frac{\partial\bar\psi}{\partial
t}+\sum\limits_{k=1}^{n}\vec v(\vec x_k,t)\cdot\nabla_{\vec
x_k}\bar\psi\right)\right)-\sum\limits_{k=1}^{n}\frac{\hbar^2}{2m_k}\nabla_{\vec
x_k}\psi\cdot\nabla_{\vec x_k}\bar\psi\\+V\left(\vec x_1,\ldots,\vec
x_n,t\right)\psi\cdot\bar\psi
-\sum\limits_{k=1}^{n}\frac{\hbar\sigma_k
i}{2m_kc}\left(\nabla_{\vec
x_k}\psi\cdot\bar\psi-\psi\cdot\nabla_{\vec
x_k}\bar\psi\right)\cdot\vec A(\vec
x_k,t)-\sum\limits_{k=1}^{n}\frac{\sigma^2_k}{2m_kc^2}\left|\vec
A(\vec x_k,t)\right|^2\psi\cdot\bar\psi\\
-\sum\limits_{k=1}^{n}\sigma_k\left(\Psi(\vec x_k,t)-\frac{1}{c}\vec
v(\vec x_k,t)\cdot\vec A(\vec x_k,t)\right)\psi\cdot\bar\psi,
\end{multline}
where $\psi(\vec x_1,\ldots,\vec x_n,t)\in\mathbb{C}$ is a wave
function of the system. Then, as before, it can be proven that $L$
is invariant under the change of inertial or non-inertial cartesian
coordinate systems of the form
\begin{equation*}
\begin{cases}
t'=t\\
\vec x'_k=A(t)\cdot\vec x_k+\vec
z(t)\quad\quad\forall k=1,\ldots,n,\end{cases}
\end{equation*}
provided that $\psi'=\psi$. We investigate stationary points of the
functional
\begin{equation}\label{btfffygtgyggyDDmmkkksmZZ}
J=\int_0^T\int_{\left(\mathbb{R}^3\right)^n}L\left(\psi,\vec
A,\Psi,\vec v,\vec x_1,\ldots,\vec x_n,t\right)d\vec x_1\ldots,d\vec
x_n dt.
\end{equation}
Then,
\begin{multline}\label{vhfffngghkjgghDDmmkkkghgsmZZ}
0=\frac{\delta L_2}{\delta (\bar\psi)}=
i\hbar\left(\frac{\partial\psi}{\partial
t}+\sum\limits_{k=1}^{n}\frac{1}{2}\vec v(\vec
x_k,t)\cdot\nabla_{\vec
x_k}\psi+\sum\limits_{k=1}^{n}\frac{1}{2}div_{\vec
x_k}\left\{\psi\vec v(\vec
x_k,t)\right\}\right)+\sum\limits_{k=1}^{n}\frac{\hbar^2}{2m_k}\Delta_{\vec
x_k}\psi\\+V\left(\vec x_1,\ldots,\vec x_n,t\right)\psi
-\sum\limits_{k=1}^{n}\frac{\hbar\sigma_k i}{2m_kc}\left(\vec A(\vec
x_k,t)\cdot\nabla_{\vec x_k}\psi+div_{\vec x_k}\left\{\psi\vec
A(\vec
x_k,t)\right\}\right)-\sum\limits_{k=1}^{n}\frac{\sigma^2_k}{2m_kc^2}\left|\vec
A(\vec x_k,t)\right|^2\psi
\\-\sum\limits_{k=1}^{n}\sigma_k\left(\Psi(\vec x_k,t)-\frac{1}{c}\vec
v(\vec x_k,t)\cdot\vec A(\vec x_k,t)\right)\psi,
\end{multline}
and
\begin{multline}\label{vhfffngghkjgghDDmmkkkghguhyuysmZZ}
0=\frac{\delta L_2}{\delta (\psi)}= \bar
{(i)}\hbar\left(\frac{\partial\bar\psi}{\partial
t}+\sum\limits_{k=1}^{n}\frac{1}{2}\vec v(\vec
x_k,t)\cdot\nabla_{\vec
x_k}\bar\psi+\sum\limits_{k=1}^{n}\frac{1}{2}div_{\vec
x_k}\left\{\bar\psi\vec v(\vec
x_k,t)\right\}\right)+\sum\limits_{k=1}^{n}\frac{\hbar^2}{2m_k}\Delta_{\vec
x_k}\bar\psi \\+V\left(\vec x_1,\ldots,\vec x_n,t\right)\bar\psi
-\sum\limits_{k=1}^{n}\frac{\hbar\sigma_k \bar
{(i)}}{2m_kc}\left(\vec A(\vec x_k,t)\cdot\nabla_{\vec
x_k}\bar\psi+div_{\vec x_k}\left\{\bar\psi\vec A(\vec
x_k,t)\right\}\right)-\sum\limits_{k=1}^{n}\frac{\sigma^2_k}{2m_kc^2}\left|\vec
A(\vec x_k,t)\right|^2\bar\psi
\\-\sum\limits_{k=1}^{n}\sigma_k\left(\Psi(\vec x_k,t)-\frac{1}{c}\vec v(\vec x_k,t)\cdot\vec
A(\vec x_k,t)\right)\bar\psi,
\end{multline}
where the last equality is just the complex conjugate of
\er{vhfffngghkjgghDDmmkkkghgsmZZ}. So we get that the Euler-Lagrange
equation for \er{btfffygtgyggyDDmmkkksmZZ} coincides with the
Schr\"{o}dinger equation of the form
\er{vhfffngghkjgghfjjghghghhjghjgghkghgghjhggjjkgfgdiyfgfjkjgjgSYSPN}.

Finally, assume that the first and the second particles have the
same mass $m_1=m_2$ and the same charge $\sigma_1=\sigma_2$ and
moreover, assume that we have
$$V\left(\vec x_1,\vec x_2,\vec x_3,\ldots,\vec
x_n,t\right)=V\left(\vec x_2,\vec x_1,\vec x_3,\ldots,\vec
x_n,t\right)\quad\quad\forall\,\vec x_1,\vec x_2,\vec
x_3,\ldots,\vec x_n\in\mathbb{R}^3,\quad\forall t.$$ In this case it
can be easily deduced that if  $\psi\left(\vec x_1,\vec x_2,\vec
x_3,\ldots,\vec x_n,t\right)$ is a solution of
\er{vhfffngghkjgghfjjghghghhjghjgghkghgghjhggjjkgfgdiyfgfjkjgjgSYSPN},
then $\psi(\vec x_2,\vec x_1,\vec x_3,\ldots,\vec x_n,t)$ as also a
solutions of
\er{vhfffngghkjgghfjjghghghhjghjgghkghgghjhggjjkgfgdiyfgfjkjgjgSYSPN},
Therefore, if $\psi\left(\vec x_1,\vec x_2,\vec x_3,\ldots,\vec
x_n,t\right)$ is a solution of
\er{vhfffngghkjgghfjjghghghhjghjgghkghgghjhggjjkgfgdiyfgfjkjgjgSYSPN},
then for every $t\geq 0$ we will have either
\begin{equation}\label{fyfgjguyiuiHHgjgjgghbmgjgfhfhhHHrr11gg}
\psi(\vec x_2,\vec x_1,\vec x_3,\ldots,\vec x_n,t)=-\psi\left(\vec
x_1,\vec x_2,\vec x_3,\ldots,\vec
x_n,t\right)\quad\quad\forall\,\vec x_1,\vec x_2,\vec
x_3,\ldots,\vec x_n\in\mathbb{R}^3,
\end{equation}
or
\begin{equation}\label{fyfgjguyiuiHHgjgjgghbmgjgfhfhhHHrr22gg}
\psi(\vec x_2,\vec x_1,\vec x_3,\ldots,\vec x_n,t)=\psi\left(\vec
x_1,\vec x_2,\vec x_3,\ldots,\vec
x_n,t\right)\quad\quad\forall\,\vec x_1,\vec x_2,\vec
x_3,\ldots,\vec x_n\in\mathbb{R}^3,
\end{equation}
provided that either \er{fyfgjguyiuiHHgjgjgghbmgjgfhfhhHHrr11gg} or
\er{fyfgjguyiuiHHgjgjgghbmgjgfhfhhHHrr22gg} respectively holds for
the initial instant of time $t=0$. So we have a consistency with the
principles of identity for two or more identical fermions or bosons,
in the cases where we do not take into account the spin interaction.

\subsection{Quantum Liouville's equation for a finite system of quantum
particles}\label{hilghjhghfdgdg} Consider the
\underline{statistical} description of the motion of a system of $n$
quantum micro-particles with inertial masses $m_1,\ldots,m_n$ and
the charges $\sigma_1,\ldots,\sigma_n$ in the outer gravitational
and electromagnetical field with characteristics $\vec v(\vec x,t)$,
$\vec A(\vec x,t)$ and $\Psi(\vec x,t)$ and additional conservative
field with potential $V(\vec y_1,\ldots,\vec y_n,t)$, not taking
into account the spin interaction. Then it is well known that the
Quantum Liouville's equation for this system of particles has the
following form:
\begin{multline}\label{vhfffngghkjgghfjjghghghhjghjgghkghggkghghjghSYSPNjj}
i\hbar\frac{\partial\xi}{\partial t}\left(\vec x_1,\ldots,\vec
x_n,\vec y_1,\ldots,\vec y_n,t\right)=\\ \hat H_0\left(\vec
x_1,\ldots,\vec x_n,t\right)\cdot\xi\left(\vec x_1,\ldots,\vec
x_n,\vec y_1,\ldots,\vec y_n,t\right)-\hat H^*_0\left(\vec
y_1,\ldots,\vec y_n,t\right)\cdot\xi\left(\vec x_1,\ldots,\vec
x_n,\vec y_1,\ldots,\vec y_n,t\right),
\end{multline}
where $\xi\left(\vec x_1,\ldots,\vec x_n,\vec y_1,\ldots,\vec
y_n,t\right)\in\mathbb{C}$ is a density-matrix function and $\hat
H_0\left(\vec x_1,\ldots,\vec x_n,t\right)$ is the Hamiltonian
operator acting on the variables $\left(\vec x_1,\ldots,\vec
x_n\right)$ and $\hat H^*_0\left(\vec y_1,\ldots,\vec y_n,t\right)$
is the \underline{complex} conjugate (not the Hermitian adjoint)  to
the Hamiltonian operator acting on the variables $\left(\vec
y_1,\ldots,\vec y_n\right)$. Since by
\er{vhfffngghkjgghfjjghghghSYSPN} the Hamiltonian for a
macro-particles has the form
\begin{multline}\label{vhfffngghkjgghfjjghghghhjghjgghkghggSYSPNjj}
H_{\text{macro}}\left(\vec P_1,\ldots,\vec P_n,\vec r_1,\ldots,\vec
r_n,t\right)=-V\left(\vec r_1,\ldots,\vec r_n,t\right)+
\sum_{j=1}^{n}\frac{1}{2}\vec P_j\cdot\vec v(\vec
r_j,t)+\sum_{j=1}^{n}\frac{1}{2}\vec v(\vec r_j,t)\cdot\vec P_j\\+
\sum_{j=1}^{n}\frac{1}{2m_j}\left|\vec P_j-\frac{\sigma_j}{c}\vec
A(\vec r_j,t)\right|^2+\sum_{j=1}^{n}\sigma_j\left(\Psi(\vec
r_j,t)-\frac{1}{c}\vec A(\vec r_j,t)\cdot\vec v(\vec r_j,t)\right),
\end{multline}
as before in
\er{vhfffngghkjgghfjjghghghhjghjgghkghgghjhggjjkgSYSPN}, we built
the Hamiltonian operator as
\begin{multline}\label{vhfffngghkjgghfjjghghghhjghjgghkghgghjhggjjkgSYSPNjj}
\hat H_0\left(\vec x_1,\ldots,\vec x_n,t\right)\cdot\psi\left(\vec
x_1,\ldots,\vec x_n,t\right)=
-\sum_{j=1}^{n}\frac{\hbar^2}{2m_j}\Delta_{\vec
x_j}\psi-\sum_{j=1}^{n}\frac{i\hbar}{2}div_{\vec x_j}\left\{\psi\vec
v(\vec x_j,t)\right\}\\-\sum_{j=1}^{n}\frac{i\hbar}{2}\vec v(\vec
x_j,t)\cdot\nabla_{\vec x_j}\psi+\sum_{j=1}^{n}\frac{
i\hbar\sigma_j}{2m_jc}div_{\vec x_j}\left\{\psi\vec A(\vec
x_j,t)\right\}+\sum_{j=1}^{n}\frac{ i\hbar\sigma_j}{2m_jc}\vec
A(\vec x_j,t)\cdot\nabla_{\vec
x_j}\psi\\+\sum_{j=1}^{n}\left(\sigma_j\Psi(\vec
x_j,t)-\frac{\sigma_j}{c}\vec A(\vec x_j,t)\cdot\vec v(\vec
x_j,t)+\frac{\sigma^2_j}{2m_jc^2}\left|\vec A(\vec
x_j,t)\right|^2\right)\psi-V\left(\vec x_1,\ldots,\vec
x_n,t\right)\psi,
\end{multline}
and consistently with
\er{vhfffngghkjgghfjjghghghhjghjgghkghgghjhggjjkgSYSPNjj}:
\begin{multline}\label{vhfffngghkjgghfjjghghghhjghjgghkghgghjhggjjkgSYSPNjjklkl}
\hat H^*_0\left(\vec y_1,\ldots,\vec y_n,t\right)\cdot\psi\left(\vec
y_1,\ldots,\vec y_n,t\right)=
-\sum_{j=1}^{n}\frac{\hbar^2}{2m_j}\Delta_{\vec
y_j}\psi+\sum_{j=1}^{n}\frac{i\hbar}{2}div_{\vec y_j}\left\{\psi\vec
v(\vec y_j,t)\right\}\\+\sum_{j=1}^{n}\frac{i\hbar}{2}\vec v(\vec
y_j,t)\cdot\nabla_{\vec y_j}\psi-\sum_{j=1}^{n}\frac{
i\hbar\sigma_j}{2m_jc}div_{\vec y_j}\left\{\psi\vec A(\vec
y_j,t)\right\}-\sum_{j=1}^{n}\frac{ i\hbar\sigma_j}{2m_jc}\vec
A(\vec y_j,t)\cdot\nabla_{\vec
y_j}\psi\\+\sum_{j=1}^{n}\left(\sigma_j\Psi(\vec
y_j,t)-\frac{\sigma_j}{c}\vec A(\vec y_j,t)\cdot\vec v(\vec
y_j,t)+\frac{\sigma^2_j}{2m_jc^2}\left|\vec A(\vec
y_j,t)\right|^2\right)\psi-V\left(\vec y_1,\ldots,\vec
y_n,t\right)\psi.
\end{multline}
Thus we rewrite the corresponding Quantum Liouville's equation
\er{vhfffngghkjgghfjjghghghhjghjgghkghggkghghjghSYSPNjj} as:
%
%
%
%
%
%
\begin{multline}\label{vhfffngghkjgghfjjghghghhjghjgghkghgghjhggjjkgfgdiyfgfjkjgjgSYSPNjj}
i\hbar\left(\frac{\partial\xi}{\partial t}+\sum_{j=1}^{n}\vec v(\vec
x_j,t)\cdot\nabla_{\vec x_j}\xi+\sum_{j=1}^{n}\vec v(\vec
y_j,t)\cdot\nabla_{\vec
y_j}\xi\right)+\sum_{j=1}^{n}\frac{i\hbar}{2}\left(div_{\vec
x_j}\vec v(\vec x_j,t)+div_{\vec y_j}\vec v(\vec y_j,t)\right)\xi=
\\
-\sum_{j=1}^{n}\frac{\hbar^2}{2m_j}\left(\Delta_{\vec
x_j}\xi-\Delta_{\vec y_j}\xi\right)-\left(V\left(\vec
x_1,\ldots,\vec x_n,t\right)-V\left(\vec y_1,\ldots,\vec
y_n,t\right)\right)\xi\\+\sum_{j=1}^{n}\frac{
i\hbar\sigma_j}{2m_jc}\left(div_{\vec x_j}\left\{\xi\vec A(\vec
x_j,t)\right\}+div_{\vec y_j}\left\{\xi\vec A(\vec
y_j,t)\right\}\right)+\sum_{j=1}^{n}\frac{
i\hbar\sigma_j}{2m_jc}\left(\vec A(\vec x_j,t)\cdot\nabla_{\vec
x_j}\xi+\vec A(\vec y_j,t)\cdot\nabla_{\vec
y_j}\xi\right)\\+\sum_{j=1}^{n}\left(\sigma_j\Psi(\vec
x_j,t)-\frac{\sigma_j}{c}\vec A(\vec x_j,t)\cdot\vec v(\vec
x_j,t)+\frac{\sigma^2_j}{2m_jc^2}\left|\vec A(\vec
x_j,t)\right|^2\right)\xi\\- \sum_{j=1}^{n}\left(\sigma_j\Psi(\vec
y_j,t)-\frac{\sigma_j}{c}\vec A(\vec y_j,t)\cdot\vec v(\vec
y_j,t)+\frac{\sigma^2_j}{2m_jc^2}\left|\vec A(\vec
y_j,t)\right|^2\right)\xi.
\end{multline}
Next consider a change of some non-inertial cartesian coordinate
system $(*)$ to another cartesian coordinate system $(**)$ of the
form \er{noninchgravortbstrjgghguittu2}:
\begin{equation}\label{noninchgravortbstrghgggSYSPNjj}
\begin{cases}
\vec x'=A(t)\cdot\vec x+\vec z(t),\\
t'=t,
\end{cases}
\end{equation}
where $A(t)\in SO(3)$ is a rotation. Then, since
\begin{equation}\label{vyfgjhgjhvhgghSYSPNjj}
\begin{cases}
\vec x'_j=A(t)\cdot\vec x_j+\vec z(t)\quad\forall j=1,\ldots,n\\
\vec y'_j=A(t)\cdot\vec y_j+\vec z(t)\quad\forall j=1,\ldots,n\\
\xi'\left(\vec x'_1,\ldots,\vec x'_n,\vec y'_1,\ldots,\vec
y'_n,t\right)=\xi\left(\vec x_1,\ldots,\vec x_n,\vec y_1,\ldots,\vec
y_n,t\right)
\\
V'\left(\vec y'_1,\ldots,\vec y'_n,t\right)=V\left(\vec
y_1,\ldots,\vec y_n,t\right)
\\
\vec A'=A(t)\cdot\vec A
\\
\Psi'-\frac{1}{c}\vec A'\cdot\vec v'=\Psi-\frac{1}{c}\vec A\cdot\vec
v,
\end{cases}
\end{equation}
we deduce that  the Quantum Liouville's equation  equation of the
form
\er{vhfffngghkjgghfjjghghghhjghjgghkghgghjhggjjkgfgdiyfgfjkjgjgSYSPNjj}
is invariant under the change of non-inertial cartesian coordinate
system, provided we have $\xi'=\xi$.

Next assume that $\hat A\left(\vec x_1,\ldots,\vec x_n,t\right)$ is
some Hermitian operator acting on the functions\\ $\psi\left(\vec
x_1,\ldots,\vec x_n,t\right)\in \mathbb{C}$ with respect to spatial
variables $\left(\vec x_1,\ldots,\vec x_n\right)$. Then the average
$\tilde A(t)$ of $\hat A$ on the density matrix $\xi$ is defined by:
\begin{equation}\label{gyfyfgfgfgghZZHHhuggkjhjjgg22}
\tilde A(t)=\frac{\int \left(\hat A\left(\vec x_1,\ldots,\vec
x_n,t\right)\cdot\xi\right)\left(\vec z_1,\ldots,\vec z_n,\vec
z_1,\ldots,\vec z_n,t\right)d\vec z_1\ldots d\vec z_n}{\int
\xi\left(\vec z_1,\ldots,\vec z_n,\vec z_1,\ldots,\vec
z_n,t\right)d\vec z_1\ldots d\vec z_n}.
\end{equation}
Thus, using the fact that $\hat A$ is Hermitian, it can be easily
deduced that in addition to \er{gyfyfgfgfgghZZHHhuggkjhjjgg22} we
have the following identity:
\begin{multline}\label{gyfyfgfgfgghZZHHhuggkjhjjggjhhh22}
\int \left(\hat A\left(\vec x_1,\ldots,\vec
x_n,t\right)\cdot\xi\right)\left(\vec z_1,\ldots,\vec z_n,\vec
z_1,\ldots,\vec z_n,t\right)d\vec z_1\ldots d\vec z_n=\\ \int
\left(\hat A^*\left(\vec y_1,\ldots,\vec
y_n,t\right)\cdot\xi\right)\left(\vec z_1,\ldots,\vec z_n,\vec
z_1,\ldots,\vec z_n,t\right)d\vec z_1\ldots d\vec z_n,
\end{multline}
where $\hat A^*\left(\vec y_1,\ldots,\vec y_n,t\right)$ is the
\underline{complex} conjugate (not the Hermitian adjoint)  to the
operator $\hat A$ acting on the variables $\left(\vec
y_1,\ldots,\vec y_n\right)$ and so,
\begin{equation}\label{gyfyfgfgfgghZZHHhuggkjhjjggjhhh22jkhjjkkjhjihjhj}
\tilde A(t)=\frac{\int \left(\hat A^*\left(\vec y_1,\ldots,\vec
y_n,t\right)\cdot\xi\right)\left(\vec z_1,\ldots,\vec z_n,\vec
z_1,\ldots,\vec z_n,t\right)d\vec z_1\ldots d\vec z_n}{\int
\xi\left(\vec z_1,\ldots,\vec z_n,\vec z_1,\ldots,\vec
z_n,t\right)d\vec z_1\ldots d\vec z_n}.
\end{equation}
In particular, by inserting \er{gyfyfgfgfgghZZHHhuggkjhjjggjhhh22}
in the particular case $\hat A=\hat H_0$ into
\er{vhfffngghkjgghfjjghghghhjghjgghkghggkghghjghSYSPNjj} we deduce:
\begin{equation}\label{gyfyfgfgfgghZZHHhuggkjhjjggjhhhguggh22}
i\hbar\frac{\partial}{\partial t}\int \xi\left(\vec z_1,\ldots,\vec
z_n,\vec z_1,\ldots,\vec z_n,t\right)d\vec z_1\ldots d\vec z_n=0,
\end{equation}
and so,
\begin{equation}\label{gyfyfgfgfgghZZHHhuggkjhjjggjhhhjhgg22}
\int \xi\left(\vec z_1,\ldots,\vec z_n,\vec z_1,\ldots,\vec
z_n,t\right)d\vec z_1\ldots d\vec z_n=\int \xi\left(\vec
z_1,\ldots,\vec z_n,\vec z_1,\ldots,\vec z_n,0\right)d\vec z_1\ldots
d\vec z_n.
\end{equation}

Next, we remind that by
\er{vhfffngghkjgghfjjghghghhjghjgghkghggkghghjghSYSPNjj} we have
\begin{multline}\label{vhfffngghkjgghfjjghghghhjghjgghkghggkghghjghSYSPNjjjhhh}
i\hbar\frac{\partial\xi}{\partial t}\left(\vec x_1,\ldots,\vec
x_n,\vec y_1,\ldots,\vec y_n,t\right)=\\ \hat H_0\left(\vec
x_1,\ldots,\vec x_n,t\right)\cdot\xi\left(\vec x_1,\ldots,\vec
x_n,\vec y_1,\ldots,\vec y_n,t\right)-\hat H^*_0\left(\vec
y_1,\ldots,\vec y_n,t\right)\cdot\xi\left(\vec x_1,\ldots,\vec
x_n,\vec y_1,\ldots,\vec y_n,t\right).
\end{multline}
Thus if we denote,
\begin{equation}\label{gyfyfgfgfgghZZHHhuggkjhjjggjhhhjhgg22jjjkjk}
\xi_1\left(\vec x_1,\ldots,\vec x_n,\vec y_1,\ldots,\vec
y_n,t\right):=\bar\xi\left(\vec y_1,\ldots,\vec y_n,\vec
x_1,\ldots,\vec x_n,t\right),
\end{equation}
and
\begin{equation}\label{gyfyfgfgfgghZZHHhuggkjhjjggjhhhjhgg22jjjkjkkljjjkkj}
\zeta\left(\vec x_1,\ldots,\vec x_n,\vec y_1,\ldots,\vec
y_n,t\right):=\xi\left(\vec x_1,\ldots,\vec x_n,\vec y_1,\ldots,\vec
y_n,t\right)-\bar\xi\left(\vec y_1,\ldots,\vec y_n,\vec
x_1,\ldots,\vec x_n,t\right),
\end{equation}
then
\begin{multline}\label{vhfffngghkjgghfjjghghghhjghjgghkghggkghghjghSYSPNjjjhhhggg}
-i\hbar\frac{\partial\bar\xi}{\partial t}\left(\vec y_1,\ldots,\vec
y_n,\vec x_1,\ldots,\vec x_n,t\right)=\\ \hat H^*_0\left(\vec
y_1,\ldots,\vec y_n,t\right)\cdot\bar\xi\left(\vec y_1,\ldots,\vec
y_n,\vec x_1,\ldots,\vec x_n,t\right)-\hat H_0\left(\vec
x_1,\ldots,\vec x_n,t\right)\cdot\bar\xi\left(\vec y_1,\ldots,\vec
y_n,\vec x_1,\ldots,\vec x_n,t\right),
\end{multline}
and then
implies\begin{multline}\label{vhfffngghkjgghfjjghghghhjghjgghkghggkghghjghSYSPNjjjhhhhjhjjhjkhjhh2}
i\hbar\frac{\partial\xi_1}{\partial t}\left(\vec x_1,\ldots,\vec
x_n,\vec y_1,\ldots,\vec y_n,t\right)=\\ \hat H_0\left(\vec
x_1,\ldots,\vec x_n,t\right)\cdot\xi_1\left(\vec x_1,\ldots,\vec
x_n,\vec y_1,\ldots,\vec y_n,t\right)-\hat H^*_0\left(\vec
y_1,\ldots,\vec y_n,t\right)\cdot\xi_1\left(\vec x_1,\ldots,\vec
x_n,\vec y_1,\ldots,\vec y_n,t\right).
\end{multline}
Therefore,
\begin{multline}\label{vhfffngghkjgghfjjghghghhjghjgghkghggkghghjghSYSPNjjjhhhhjhjjhjkhjhh}
i\hbar\frac{\partial\zeta}{\partial t}\left(\vec x_1,\ldots,\vec
x_n,\vec y_1,\ldots,\vec y_n,t\right)=\\ \hat H_0\left(\vec
x_1,\ldots,\vec x_n,t\right)\cdot\zeta\left(\vec x_1,\ldots,\vec
x_n,\vec y_1,\ldots,\vec y_n,t\right)-\hat H^*_0\left(\vec
y_1,\ldots,\vec y_n,t\right)\cdot\zeta\left(\vec x_1,\ldots,\vec
x_n,\vec y_1,\ldots,\vec y_n,t\right).
\end{multline}
Thus, if $\zeta$ satisfies initial condition $\zeta\left(\vec
x_1,\ldots,\vec x_n,\vec y_1,\ldots,\vec y_n,0\right)=0$, then
$\zeta\left(\vec x_1,\ldots,\vec x_n,\vec y_1,\ldots,\vec
y_n,t\right)=0$ for every $t>0$. So
\begin{multline}\label{gyfyfgfgfgghZZHHhuggkjhjjggjhhhjhgg22jjjkjkkljjjkkjopiiio}
\xi\left(\vec x_1,\ldots,\vec x_n,\vec y_1,\ldots,\vec
y_n,0\right)=\bar\xi\left(\vec y_1,\ldots,\vec y_n,\vec
x_1,\ldots,\vec x_n,0\right)\quad\text{implies}\quad\\ \xi\left(\vec
x_1,\ldots,\vec x_n,\vec y_1,\ldots,\vec
y_n,t\right)=\bar\xi\left(\vec y_1,\ldots,\vec y_n,\vec
x_1,\ldots,\vec x_n,t\right)\quad\forall\, t\geq 0.
\end{multline}

Finally, assume that the first and the second particles have the
same mass $m_1=m_2$ and the same charge $\sigma_1=\sigma_2$ and
moreover, assume that we have
$$V\left(\vec x_1,\vec x_2,\vec x_3,\ldots,\vec
x_n,t\right)=V\left(\vec x_2,\vec x_1,\vec x_3,\ldots,\vec
x_n,t\right)\quad\quad\forall\,\vec x_1,\vec x_2,\vec
x_3,\ldots,\vec x_n\in\mathbb{R}^3,\quad\forall t.$$ In this case it
can be easily deduced that if  $\xi\left(\vec x_1,\vec x_2,\vec
x_3,\ldots,\vec x_n,\vec y_1,\vec y_2,\vec y_3,\ldots,\vec
y_n,t\right)$ is a solution of
\er{vhfffngghkjgghfjjghghghhjghjgghkghggkghghjghSYSPNjjjhhh}, then
$\xi(\vec x_2,\vec x_1,\vec x_3,\ldots,\vec x_n,\vec y_2,\vec
y_1,\vec y_3,\ldots,\vec y_n,t)$
is also a solution of
\er{vhfffngghkjgghfjjghghghhjghjgghkghggkghghjghSYSPNjjjhhh}.
Therefore, if $\xi\left(\vec x_1,\vec x_2,\vec x_3,\ldots,\vec
x_n,\vec y_1,\vec y_2,\vec y_3,\ldots,\vec y_n,t\right)$ is a
solution of
\er{vhfffngghkjgghfjjghghghhjghjgghkghggkghghjghSYSPNjjjhhh}, then
for every $t\geq 0$ we will have
%
%
%
%
%
%
\begin{multline}\label{fyfgjguyiuiHHgjgjgghbmgjgfhfhhHHrr22}
\xi(\vec x_2,\vec x_1,\vec x_3,\ldots,\vec x_n,\vec y_2,\vec
y_1,\vec y_3,\ldots,\vec y_n,t)=\xi\left(\vec x_1,\vec x_2,\vec
x_3,\ldots,\vec x_n,\vec y_1,\vec y_2,\vec y_3,\ldots,\vec
y_n,t\right)
\\ \quad\quad\forall\,\vec x_1,\vec x_2,\vec
x_3,\ldots,\vec x_n,\vec y_1,\vec y_2,\vec y_3,\ldots,\vec
y_n\in\mathbb{R}^3,
\end{multline}
provided that
\er{fyfgjguyiuiHHgjgjgghbmgjgfhfhhHHrr22} holds for the initial
instant of time $t=0$. So we have a consistency with the principles
of identity for two or more identical fermions or bosons, in the
cases where we do not take into account the spin interaction.

Next given $\xi\left(\vec x_1,\ldots,\vec x_n,\vec y_1,\ldots,\vec
y_n,t\right)$ that satisfies
\begin{equation}\label{gyfyfgfgfgghZZHHhuggkjhjjggjhhhjhgg22jjjkjkkljjjkkjopiiiojjkljkjk}
\xi\left(\vec x_1,\ldots,\vec x_n,\vec y_1,\ldots,\vec
y_n,t\right)=\bar\xi\left(\vec y_1,\ldots,\vec y_n,\vec
x_1,\ldots,\vec x_n,t\right)\quad\forall\, t\geq 0,
\end{equation}
define the operator $\hat R_\xi$ by:
\begin{equation}\label{gyfyfgfgfgghZZHHhuggkjhjjggjhhhjhgg22jjjkjkkljjjkkjopiiiojjkljkjkjkljkjk}
\hat R_\xi\left(\vec x_1,\ldots,\vec x_n,t\right)\cdot\psi\left(\vec
x_1,\ldots,\vec x_n,t\right)=\int\xi\left(\vec x_1,\ldots,\vec
x_n,\vec y_1,\ldots,\vec y_n,t\right)\psi\left(\vec y_1,\ldots,\vec
y_n,t\right)d\vec y_1\ldots d\vec y_n.
\end{equation}
Then the mapping $\xi\to\hat R_\xi$ is one-to-one. Moreover, by
\er{gyfyfgfgfgghZZHHhuggkjhjjggjhhhjhgg22jjjkjkkljjjkkjopiiiojjkljkjk}
$\hat R_\xi$ is an Hermitian operator. Finally by
\er{vhfffngghkjgghfjjghghghhjghjgghkghggkghghjghSYSPNjjjhhh} we have
\begin{multline}\label{vhfffngghkjgghfjjghghghhjghjgghkghggkghghjghSYSPNjjjhhhjjkjkjklkjljl}
i\hbar\frac{\partial\hat R_\xi}{\partial t}\left(\vec
x_1,\ldots,\vec x_n,t\right)=\\ \hat H_0\left(\vec x_1,\ldots,\vec
x_n,t\right)\cdot\hat R_\xi\left(\vec x_1,\ldots,\vec
x_n,t\right)-\hat R_\xi\left(\vec x_1,\ldots,\vec
x_n,t\right)\cdot\hat H_0\left(\vec x_1,\ldots,\vec x_n,t\right).
\end{multline}
Equation
\er{vhfffngghkjgghfjjghghghhjghjgghkghggkghghjghSYSPNjjjhhhjjkjkjklkjljl}
is equivalent to
\er{vhfffngghkjgghfjjghghghhjghjgghkghggkghghjghSYSPNjjjhhh}.

Next, clearly $\xi$ satisfies
\er{fyfgjguyiuiHHgjgjgghbmgjgfhfhhHHrr22} if and only if we have
either the following:
\begin{multline}\label{fyfgjguyiuiHHgjgjgghbmgjgfhfhhHHljjjhjhjhyyuyuihhjjk}
\psi(\vec x_2,\vec x_1,\vec x_3,\ldots,\vec x_n,t)=-\psi(\vec
x_1,\vec x_2,\vec x_3,\ldots,\vec x_n,t)\;\;\forall\,\vec
x_1,\ldots,\vec x_n\in\mathbb{R}^3\quad\text{and}\quad\phi=\hat
R_\xi\cdot\psi\\ \quad\text{implies}\quad \phi(\vec x_2,\vec
x_1,\vec x_3,\ldots,\vec x_n,t)=-\phi(\vec x_1,\vec x_2,\vec
x_3,\ldots,\vec x_n,t)\;\;\forall\,\vec x_1,\ldots,\vec
x_n\in\mathbb{R}^3,
\end{multline}
or the following:
\begin{multline}\label{fyfgjguyiuiHHgjgjgghbmgjgfhfhhHHljjjhjhjhyyuyuihhjhj}
\psi(\vec x_2,\vec x_1,\vec x_3,\ldots,\vec x_n,t)=\psi(\vec
x_1,\vec x_2,\vec x_3,\ldots,\vec x_n,t)\;\;\forall\,\vec
x_1,\ldots,\vec x_n\in\mathbb{R}^3\quad\text{and}\quad\phi=\hat
R_\xi\cdot\psi\\ \quad\text{implies}\quad \phi(\vec x_2,\vec
x_1,\vec x_3,\ldots,\vec x_n,t)=\phi(\vec x_1,\vec x_2,\vec
x_3,\ldots,\vec x_n,t)\;\;\forall\,\vec x_1,\ldots,\vec
x_n\in\mathbb{R}^3.
\end{multline}

Finally, assume that $\hat A\left(\vec x_1,\ldots,\vec x_n,t\right)$
is some Hermitian operator acting on the functions\\ $\psi\left(\vec
x_1,\ldots,\vec x_n,t\right)\in \mathbb{C}$ with respect to spatial
variables $\left(\vec x_1,\ldots,\vec x_n\right)$. Then we remind
that the average $\tilde A(t)$ of $\hat A$ on the density matrix
$\xi$, is defined by \er{gyfyfgfgfgghZZHHhuggkjhjjgg22}:
\begin{equation}\label{gyfyfgfgfgghZZHHhuggkjhjjgg22zz}
\tilde A(t)=\frac{\int \left(\hat A\left(\vec x_1,\ldots,\vec
x_n,t\right)\cdot\xi\right)\left(\vec z_1,\ldots,\vec z_n,\vec
z_1,\ldots,\vec z_n,t\right)d\vec z_1\ldots d\vec z_n}{\int
\xi\left(\vec z_1,\ldots,\vec z_n,\vec z_1,\ldots,\vec
z_n,t\right)d\vec z_1\ldots d\vec z_n}.
\end{equation}
Thus,
\begin{equation}\label{gyfyfgfgfgghZZHHhuggkjhjjgg22hjhjjhjh}
trace \left(\hat A\circ \hat R_\xi\right)=\int \left(\hat
A\left(\vec x_1,\ldots,\vec x_n,t\right)\cdot\xi\right)\left(\vec
z_1,\ldots,\vec z_n,\vec z_1,\ldots,\vec z_n,t\right)d\vec z_1\ldots
d\vec z_n,
\end{equation}
and so,
\begin{equation}\label{gyfyfgfgfgghZZHHhuggkjhjjgg22hjhjjhjhmnbmbkjkjkh}
\tilde A(t)=\frac{trace \left(\hat A\circ \hat R_\xi\right)}{trace
\left(\hat R_\xi\right)}=\frac{trace \left( \hat R_\xi\circ \hat
A\right)}{trace \left(\hat R_\xi\right)},
\end{equation}
where by the trace we mean the trace of an operator on a Hilbert
space. Moreover, by \er{gyfyfgfgfgghZZHHhuggkjhjjggjhhhjhgg22} and
\er{gyfyfgfgfgghZZHHhuggkjhjjgg22hjhjjhjh} we have
\begin{equation}\label{gyfyfgfgfgghZZHHhuggkjhjjgg22hjhjjhjhmnbmbjkjhjhjhhhjhj}
trace \left( \hat R_\xi \right)(t)=trace \left( \hat
R_\xi\right)(0).
\end{equation}

We also prove the following:
\begin{proposition}\label{hugyfyfffgh}
Let $\hat A:=\hat A(t)$ and $\hat H_0:=\hat H_0(t)$ be two
time-dependent Hermitian operators on a given abstract Hilbert
space, such that
\begin{equation}\label{vhfffngghkjgghfjjghghghhjghjgghkghggkghghjghSYSPNjjjhhhjjkjkjklkjljljkkhhkjiijji55}
i\hbar\frac{\partial\hat A}{\partial t}(t)= \hat H_0(t)\cdot\hat
A(t)-\hat A(t)\cdot\hat H_0(t)\quad\quad\forall t.
\end{equation}
If, given the holomorphic function $f(z):\mathbb{C}\to \mathbb{C}$
defined as a sum of the power series
\begin{equation}\label{vhfffngghkjgghfjjghghghhjghjgghkghggkghghjghSYSPNjjjhhhjjkjkjklkjljljkkhhkjiijjihhjjhjhjhjhkkhhk55nn}
f(z):=\sum_{m=0}^{+\infty}a_mz^m,
\end{equation}
with $a_m\in\mathbb{C}$, we define the operator $(f\circ \hat
A):=(f\circ \hat A)(t)$ as:
\begin{equation}\label{vhfffngghkjgghfjjghghghhjghjgghkghggkghghjghSYSPNjjjhhhjjkjkjklkjljljkkhhkjiijjihhjjhjhjhjhkkhhknnhkhh55}
(f\circ \hat A)(t):=\sum_{m=0}^{+\infty}a_m\hat A^m(t),
\end{equation}
then:
\begin{equation}\label{vhfffngghkjgghfjjghghghhjghjgghkghggkghghjghSYSPNjjjhhhjjkjkjklkjljljkkhhkjiijjijhbmbgghgh55}
i\hbar\frac{\partial}{\partial t}\left(f\circ\hat  A\right)(t)= \hat
H_0(t)\cdot\left(f\circ \hat A\right)(t)-\left(f\circ \hat
A\right)(t)\cdot\hat H_0(t).
\end{equation}
\end{proposition}
\begin{proof}
We will prove now that for every $m=0,1,2,3,\ldots$ we have
\begin{equation}\label{vhfffngghkjgghfjjghghghhjghjgghkghggkghghjghSYSPNjjjhhhjjkjkjklkjljljkkhhkjiijjihhjjhjhjhjhkuou55}
i\hbar\frac{\partial\hat A^m}{\partial t}= \hat H_0\cdot\hat
A^m-\hat A^m\cdot\hat H_0.
\end{equation}
Indeed, for $m=0$ we have
\begin{equation}\label{vhfffngghkjgghfjjghghghhjghjgghkghggkghghjghSYSPNjjjhhhjjkjkjklkjljljkkhhkjiijjihhjjhjhjhjhkuouhhhhh55}
i\hbar\frac{\partial\hat A^0}{\partial t}=i\hbar\frac{\partial
{(\hat Id)}}{\partial t}=0= \hat H_0\cdot\hat{Id}-\hat{Id}\cdot\hat
H_0=\hat H_0\cdot\hat A^0-\hat A^0\cdot\hat H_0.
\end{equation}
Next, if for some $m=k$ we have
\begin{equation}\label{vhfffngghkjgghfjjghghghhjghjgghkghggkghghjghSYSPNjjjhhhjjkjkjklkjljljkkhhkjiijjihhjjhjhjhjh55}
i\hbar\frac{\partial\hat A^k}{\partial t}= \hat H_0\cdot\hat
A^k-\hat A^k\cdot\hat H_0,
\end{equation}
then by
\er{vhfffngghkjgghfjjghghghhjghjgghkghggkghghjghSYSPNjjjhhhjjkjkjklkjljljkkhhkjiijji55}
and
\er{vhfffngghkjgghfjjghghghhjghjgghkghggkghghjghSYSPNjjjhhhjjkjkjklkjljljkkhhkjiijjihhjjhjhjhjh55}
we obtain
\begin{multline}\label{vhfffngghkjgghfjjghghghhjghjgghkghggkghghjghSYSPNjjjhhhjjkjkjklkjljljkkhhkjiijjihhjjhjhjhjhhihh55}
i\hbar\frac{\partial\hat A^{k+1}}{\partial t}=\hat
A\cdot\left(i\hbar\frac{\partial\hat A^k}{\partial
t}\right)+\left(i\hbar\frac{\partial\hat A}{\partial
t}\right)\cdot\hat A^k\\ =\hat A\cdot\left(\hat H_0\cdot\hat
A^k-\hat A^k\cdot\hat H_0\right)+ \left(\hat H_0\cdot\hat A-\hat
A\cdot\hat H_0\right)\cdot\hat A^k =\hat H_0\cdot\hat A^{k+1}-\hat
A^{k+1}\cdot\hat H_0.
\end{multline}
So we prove
\er{vhfffngghkjgghfjjghghghhjghjgghkghggkghghjghSYSPNjjjhhhjjkjkjklkjljljkkhhkjiijjihhjjhjhjhjhkuou55}
for $m=k+1$ and thus for every $m$.

Therefore, given the holomorphic function $f(z):\mathbb{C}\to
\mathbb{C}$ defined as a sum of the power series
\begin{equation}\label{vhfffngghkjgghfjjghghghhjghjgghkghggkghghjghSYSPNjjjhhhjjkjkjklkjljljkkhhkjiijjihhjjhjhjhjhkkhhk55rr}
f(z):=\sum_{m=0}^{+\infty}a_mz^m,
\end{equation}
with $a_m\in\mathbb{C}$, if we define the operator $f\circ \hat A$
as:
\begin{equation}\label{vhfffngghkjgghfjjghghghhjghjgghkghggkghghjghSYSPNjjjhhhjjkjkjklkjljljkkhhkjiijjihhjjhjhjhjhkkhhknnhkhh55xx}
f\circ \hat A:=\sum_{m=0}^{+\infty}a_m\hat A^m,
\end{equation}
by
\er{vhfffngghkjgghfjjghghghhjghjgghkghggkghghjghSYSPNjjjhhhjjkjkjklkjljljkkhhkjiijjihhjjhjhjhjhkuou55}
and
\er{vhfffngghkjgghfjjghghghhjghjgghkghggkghghjghSYSPNjjjhhhjjkjkjklkjljljkkhhkjiijjihhjjhjhjhjhkkhhknnhkhh55xx}
we finally deduce
\er{vhfffngghkjgghfjjghghghhjghjgghkghggkghghjghSYSPNjjjhhhjjkjkjklkjljljkkhhkjiijjijhbmbgghgh55}.
\end{proof}

\subsubsection{Thermodynamical equilibrium; canonical
ensemble} Let $\xi\left(\vec x_1,\ldots,\vec x_n,\vec
y_1,\ldots,\vec y_n,t\right)$ be the density matrix of the quantum
system, ruled by the Hamiltonian operator $\hat H_0$ from
\er{vhfffngghkjgghfjjghghghhjghjgghkghgghjhggjjkgSYSPNjj}, in the
case of \underline{thermodynamical equilibrium} and $\hat R_\xi$ is
given by
\er{gyfyfgfgfgghZZHHhuggkjhjjggjhhhjhgg22jjjkjkkljjjkkjopiiiojjkljkjkjkljkjk}
for this $\xi$. Then in the case that $\frac{\partial \hat
H_0}{\partial t}\equiv 0$ and the given equilibrium system rests
macroscopically, i.e. it has the macroscopical velocity field zero:
$\vec u(\vec x,t)\equiv 0$ it is well known that
\begin{equation}\label{vhfffngghkjgghfjjghghghSYSPNjljllkljjhkhkhklpl;lpoopkklkljlrrcc}
\hat R_\xi=\frac{1}{trace \left( f\circ \hat H_0 \right)} f\circ
\hat H_0,
\end{equation}
and in the case of a system in thermostat, having the Kelvin
temperature $T$, we have the canonical ensemble:
\begin{equation}\label{vhfffngghkjgghfjjghghghSYSPNjljllkljjhkhkhklpl;lpoopkklkljlrrcchhkhcc}
f(s)=e^{-\frac{s}{kT}}=\sum_{m=0}^{+\infty}\frac{1}{m!}\left(-\frac{s}{kT}\right)^m\quad\forall
s\in\mathbb{C},
\end{equation}
where $k$ is the Boltzmann constant and by $f\circ \hat H_0$ we
denote the following operator
\begin{equation}\label{vhfffngghkjgghfjjghghghSYSPNjljllkljjhkhkhklpl;lpoopkklkljlrrcchhkhcchjhjgcc}
f\circ \hat
H_0=\sum_{m=0}^{+\infty}\frac{1}{m!}\left(-\frac{1}{kT}\hat
H_0\right)^m\quad\forall s\in\mathbb{C}.
\end{equation}
We would like to find alternative form of the above law of
thermodynamical equilibrium, which is invariant under the change of
inertial or non-inertial cartesian coordinate system. Then it is
clear that if the given system rests in the old coordinate system,
then it obviously has non-trivial macroscopical velocity field $\vec
u(\vec x,t)$ in the new one. Moreover, $\vec u(\vec x,t)$ can depend
on $\vec x$ and $t$, as it indeed happens in the case of a rotation
of the new coordinate system with respect to the old one. On the
other hand the concept of thermodynamical equilibrium is clearly
independent of the coordinate system.

In order to find the forms of the law of thermodynamical
equilibrium, which are indeed invariant under the change of inertial
or non-inertial cartesian coordinate system we follow the steps as
below. Given a speed-like vector field $\vec u=\vec u(\vec x,t)$,
define
\begin{equation}\label{vhfffngghkjgghfjjghghghSYSPNjljll'ljjjjjjhkjhjhjjkjkkjjz}
\hat H_{\vec u}\left(\vec x_1,\ldots,\vec x_n,t\right):= \hat
H_0\left(\vec x_1,\ldots,\vec
x_n,t\right)-\frac{1}{2}\left(\sum_{j=1}^{n}\hat {\vec P}_j\cdot\vec
u(\vec x_j,t)+\vec u(\vec x_j,t)\cdot \hat {\vec P}_j\right),
\end{equation}
where $\hat{\vec P}_j$ is the operator defined as
\begin{equation}\label{vhfffngghkjgghfjjghghghSYSPNjljll'ljjjjjjhkjhjhjjkjkkljjljjkkjjz}
\hat {\vec P}_j\cdot\psi\left(\vec x_1,\ldots,\vec
x_n\right)=-i\hbar\nabla_{\vec x_j}\psi\left(\vec x_1,\ldots,\vec
x_n\right).
\end{equation}
Then by \er{vhfffngghkjgghfjjghghghhjghjgghkghgghjhggjjkgSYSPNjj} we
have
\begin{multline}\label{vhfffngghkjgghfjjghghghhjghjgghkghgghjhggjjkgSYSPNjjjljjkljjkjkjjz}
\hat H_{\vec u}\left(\vec x_1,\ldots,\vec
x_n,t\right)\cdot\psi\left(\vec x_1,\ldots,\vec x_n,t\right)=
-\sum_{j=1}^{n}\frac{\hbar^2}{2m_j}\Delta_{\vec
x_j}\psi-\sum_{j=1}^{n}\frac{i\hbar}{2}div_{\vec
x_j}\left\{\psi\left(\vec v(\vec x_j,t)-\vec u(\vec
x_j,t)\right)\right\}\\-\sum_{j=1}^{n}\frac{i\hbar}{2}\left(\vec
v(\vec x_j,t)-\vec u(\vec x_j,t)\right)\cdot\nabla_{\vec
x_j}\psi+\sum_{j=1}^{n}\frac{ i\hbar\sigma_j}{2m_jc}div_{\vec
x_j}\left\{\psi\vec A(\vec x_j,t)\right\}+\sum_{j=1}^{n}\frac{
i\hbar\sigma_j}{2m_jc}\vec A(\vec x_j,t)\cdot\nabla_{\vec
x_j}\psi\\+\sum_{j=1}^{n}\left(\sigma_j\Psi(\vec
x_j,t)-\frac{\sigma_j}{c}\vec A(\vec x_j,t)\cdot\vec v(\vec
x_j,t)+\frac{\sigma^2_j}{2m_jc^2}\left|\vec A(\vec
x_j,t)\right|^2\right)\psi-V\left(\vec x_1,\ldots,\vec
x_n,t\right)\psi.
\end{multline}
Thus, as before, we can prove that
\er{vhfffngghkjgghfjjghghghhjghjgghkghgghjhggjjkgSYSPNjjjljjkljjkjkjjz}
is invariant under the change of inertial or non-inertial cartesian
coordinate systems i.e.
\begin{multline}\label{vhfffngghkjgghfjjghghghhjghjgghkghgghjhggjjkgSYSPNjjjljjkljjkjkjjz1155}
\hat H'_{\vec u'}\left(\vec x'_1,\ldots,\vec
x'_n,t'\right)\cdot\psi'\left(\vec x'_1,\ldots,\vec
x'_n,t'\right)=\phi'\left(\vec x'_1,\ldots,\vec x'_n,t'\right)
\quad\text{if and only if}\quad\\  \hat H_{\vec u}\left(\vec
x_1,\ldots,\vec x_n,t\right)\cdot\psi\left(\vec x_1,\ldots,\vec
x_n,t\right)=\phi\left(\vec x_1,\ldots,\vec x_n,t\right),
\end{multline}
provided that, as before, we have
\begin{equation}\label{yuythfgfyftydtydtydtyddyyyhhddhhhredPPN111hgghjgintintrrZZHHhkkhjhj}
\begin{cases}
V'\left(\vec x'_1,\ldots,\vec x'_n,t'\right)\,=\,V\left(\vec
x_1,\ldots,\vec x_n,t\right),\\
\sigma'_j\,=\,\sigma_j,\\
m'_j\,=\,m_j,\\
\vec v'(\vec x',t)\,=\,A(t)\cdot \vec v(\vec x,t)+\frac{dA}{dt}(t)\cdot\vec x+\frac{d\vec z}{dt}(t),\\
\vec u'(\vec x',t)\,=\,A(t)\cdot \vec u(\vec x,t)+\frac{dA}{dt}(t)\cdot\vec x+\frac{d\vec z}{dt}(t),\\
\vec A'(\vec x',t)\,=\,A(t)\cdot \vec A(\vec x,t),\\
\Psi'(\vec x',t)-\vec v'(\vec x',t)\cdot\vec A'(\vec x',t)\,=\,\Psi(\vec x,t)-\vec v(\vec x,t)\cdot\vec A(\vec x,t),\\
\psi'\left(\vec x'_1,\ldots,\vec x'_n,t'\right)=\psi\left(\vec
x_1,\ldots,\vec x_n,t\right),\\
\phi'\left(\vec x'_1,\ldots,\vec x'_n,t'\right)=\phi\left(\vec
x_1,\ldots,\vec x_n,t\right).
\end{cases}
\end{equation}
%
%
%
%
%
%
Next consider
\begin{equation}\label{vhfffngghkjgghfjjghghghSYSPNjljllkljjhkhkhklpl;lpooprrcc}
\hat R_\xi=\frac{1}{trace \left( f\circ \hat H_{\vec u} \right)}
f\circ \hat H_{\vec u},
\end{equation}
where, as before, in the case of a system in thermostat, having the
Kelvin temperature $T$, we have the canonical ensemble:
\begin{equation}\label{vhfffngghkjgghfjjghghghSYSPNjljllkljjhkhkhklpl;lpoopkklkljlrrcchhkhcchhhjjkljcc}
f(s)=e^{-\frac{s}{kT}}=\sum_{m=0}^{+\infty}\frac{1}{m!}\left(-\frac{s}{kT}\right)^m\quad\forall
s\in\mathbb{C}
\end{equation}
with $k$ being the Boltzmann constant and $f\circ \hat H_{\vec u}$
is the following operator
\begin{equation}\label{vhfffngghkjgghfjjghghghSYSPNjljllkljjhkhkhklpl;lpoopkklkljlrrcchhkhcchjhjgcckjhhjcc}
f\circ \hat H_{\vec
u}=\sum_{m=0}^{+\infty}\frac{1}{m!}\left(-\frac{1}{kT}\hat H_{\vec
u}\right)^m\quad\forall s\in\mathbb{C}.
\end{equation}
We would like to note here that since the operator $ \hat H_{\vec
u}$ is bounded from below and $f(s)$ decays rapidly as
$s\to+\infty$, there exists a density matrix $\xi_1$, such that the
operator $\hat R_{\xi_1}$, given by
\er{gyfyfgfgfgghZZHHhuggkjhjjggjhhhjhgg22jjjkjkkljjjkkjopiiiojjkljkjkjkljkjk},
equals to the operator $f\circ \hat H_{\vec u}$ and thus
\er{vhfffngghkjgghfjjghghghSYSPNjljllkljjhkhkhklpl;lpooprrcc} indeed
has sense. Next by
\er{vhfffngghkjgghfjjghghghhjghjgghkghgghjhggjjkgSYSPNjjjljjkljjkjkjjz1155}
and
\er{vhfffngghkjgghfjjghghghSYSPNjljllkljjhkhkhklpl;lpoopkklkljlrrcchhkhcchjhjgcckjhhjcc}
the operator $\hat R_\xi$ in the left hand side of
\er{vhfffngghkjgghfjjghghghSYSPNjljllkljjhkhkhklpl;lpooprrcc} is
invariant under the change of inertial or non-inertial cartesian
coordinate systems i.e.
\begin{multline}\label{vhfffngghkjgghfjjghghghhjghjgghkghgghjhggjjkgSYSPNjjjljjkljjkjkjjz1155kjjkjk}
\hat R_{\xi'}\left(\vec x'_1,\ldots,\vec
x'_n,t'\right)\cdot\psi'\left(\vec x'_1,\ldots,\vec
x'_n,t'\right)=\phi'\left(\vec x'_1,\ldots,\vec x'_n,t'\right)
\quad\text{if and only if}\quad\\  \hat R_\xi\left(\vec
x_1,\ldots,\vec x_n,t\right)\cdot\psi\left(\vec x_1,\ldots,\vec
x_n,t\right)=\phi\left(\vec x_1,\ldots,\vec x_n,t\right),
\end{multline}
provided that, as before, we have
\er{yuythfgfyftydtydtydtyddyyyhhddhhhredPPN111hgghjgintintrrZZHHhkkhjhj}
and $\xi'=\xi$. Moreover, in the case $\vec u\equiv 0$
\er{vhfffngghkjgghfjjghghghSYSPNjljllkljjhkhkhklpl;lpooprrcc}
coincides with
\er{vhfffngghkjgghfjjghghghSYSPNjljllkljjhkhkhklpl;lpoopkklkljlrrcc}.
However, we still need to derive the restrictions on the field $\vec
u$ and the Hamiltonian operator $\hat H_0$, providing that our
system can indeed be found in the state of thermodynamical
equilibrium. We remind that in the case $\vec u\equiv 0$ the
appropriate restriction is $\frac{\partial \hat H_0}{\partial
t}\equiv 0$. In order to get these restrictions in the general case,
we need to insert $\hat R_\xi$ in
\er{vhfffngghkjgghfjjghghghSYSPNjljllkljjhkhkhklpl;lpooprrcc} into
the equation in
\er{vhfffngghkjgghfjjghghghhjghjgghkghggkghghjghSYSPNjjjhhhjjkjkjklkjljl}
which is equivalent to the quantum Liouville equation.

Therefore, assume that the vector field $\vec u$ satisfies
\er{vhfffngghkjgghfjjghghghhjghjgghkghggSYSPNkljljjkklmjkjkjjk,mmm,kl;k;lklkjkjkkjkjkjkjljljjljklggghgkhhkhkhkkklkklklkljkhkhjjkhjh}
with \er{vhfffngghkjgghfjjghghghSYSPNjljll'lhkhhjgggjgjgghgjkjjk}
i.e.
\begin{equation}\label{vhfffngghkjgghfjjghghghhjghjgghkghggSYSPNkljljjkklmjkjkjjk,mmm,kl;k;lklkjkjkkjkjkjkjljljjljklggghgkhhkhkhkkklkklklkljkhkhjjkhjhmnnmmn}
\begin{cases}
\frac{\partial U}{\partial t}\left(\vec x_1,\ldots,\vec
x_n,t\right)+\sum_{j=1}^{n}\vec u( \vec x_j,t)\cdot\nabla_{\vec
x_j}U\left(\vec x_1,\ldots,\vec x_n,t\right)=0,\\
\frac{\partial}{\partial t}\left(\vec u(\vec x,t)-\vec v(\vec
x,t)\right)-curl_{\vec x}\left(\vec u(\vec x,t)\times\left(\vec
u(\vec x,t)-\vec v(\vec x,t)\right)\right)+\left(\Div_{\vec
x}\left(\vec u(\vec x,t)-\vec v(\vec x,t)\right)\right)\vec
u(\vec x,t)=0,\\
\frac{\partial\vec A}{\partial t}(\vec x,t)-curl_{\vec x}\left(\vec
u(\vec x,t)\times\vec A(\vec x,t)\right)+\left(\Div_{\vec x}\vec
A(\vec x,t)\right)\vec u(\vec x,t)=0,
\\
d_{\vec x}\vec u(\vec x,t)+\left\{d_{\vec x}\vec u(\vec
x,t)\right\}^T=0,
\end{cases}
\end{equation}
where
\begin{multline}\label{vhfffngghkjgghfjjghghghSYSPNjljll'lhkhhjgggjgjgghgjkjjkjkjk}
U\left(\vec x_1,\ldots,\vec x_n,t\right):=\\
\sum_{j=1}^{n}\left(\frac{\sigma^2_j}{2m_jc^2}\left|\vec A(\vec
x_j,t)\right|^2+\sigma_j\Psi(\vec x_j,t)-\frac{\sigma_j}{c}\vec
A(\vec x_j,t)\cdot\vec v(\vec x_j,t)\right)-V\left(\vec
x_1,\ldots,\vec x_n,t\right).
\end{multline}
Then, by Proposition \ref{eqv1} below we have
\begin{equation}\label{vhfffngghkjgghfjjghghghhjghjgghkghggkghghjghSYSPNjjjhhhjjkjkjklkjljljkkhhkjiijjijhbmbgghghcchkhh}
i\hbar\frac{\partial}{\partial t}\left(f\circ \hat H_{\vec
u}\right)= \hat H_0\cdot\left(f\circ \hat H_{\vec
u}\right)-\left(f\circ \hat H_{\vec u}\right)\cdot\hat H_0,
\end{equation}
Then by
\er{vhfffngghkjgghfjjghghghhjghjgghkghggkghghjghSYSPNjjjhhhjjkjkjklkjljljkkhhkjiijjijhbmbgghghcchkhh}
together with
\er{gyfyfgfgfgghZZHHhuggkjhjjgg22hjhjjhjhmnbmbjkjhjhjhhhjhj} we
deduce that
\begin{equation}\label{vhfffngghkjgghfjjghghghhjghjgghkghggkghghjghSYSPNjjjhhhjjkjkjklkjljljkkhhkjiijjijhbmbgghghcchkhhjkjkk}
i\hbar\frac{\partial}{\partial t}\left(\hat R_\xi\right)= \hat
H_0\cdot\left(\hat R_\xi\right)-\left(\hat R_\xi\right)\cdot\hat
H_0,
\end{equation}
where $\hat R_\xi$ is the operator in the left hand side of
\er{vhfffngghkjgghfjjghghghSYSPNjljllkljjhkhkhklpl;lpooprrcc}. So we
indeed get
\er{vhfffngghkjgghfjjghghghhjghjgghkghggkghghjghSYSPNjjjhhhjjkjkjklkjljl}
in the case of
\er{vhfffngghkjgghfjjghghghhjghjgghkghggSYSPNkljljjkklmjkjkjjk,mmm,kl;k;lklkjkjkkjkjkjkjljljjljklggghgkhhkhkhkkklkklklkljkhkhjjkhjhmnnmmn},
\er{vhfffngghkjgghfjjghghghSYSPNjljll'lhkhhjgggjgjgghgjkjjkjkjk}.
Moreover,
\er{vhfffngghkjgghfjjghghghhjghjgghkghggSYSPNkljljjkklmjkjkjjk,mmm,kl;k;lklkjkjkkjkjkjkjljljjljklggghgkhhkhkhkkklkklklkljkhkhjjkhjhmnnmmn},
\er{vhfffngghkjgghfjjghghghSYSPNjljll'lhkhhjgggjgjgghgjkjjkjkjk} are
invariant under the change of inertial or non-inertial coordinate
system.
\begin{proposition}\label{eqv1}
Assume that the speed-like vector field $\vec u$ satisfies
\er{vhfffngghkjgghfjjghghghhjghjgghkghggSYSPNkljljjkklmjkjkjjk,mmm,kl;k;lklkjkjkkjkjkjkjljljjljklggghgkhhkhkhkkklkklklkljkhkhjjkhjhmnnmmn}
and
\er{vhfffngghkjgghfjjghghghSYSPNjljll'lhkhhjgggjgjgghgjkjjkjkjk}.
Next, assume that the holomorphic function $f(z):\mathbb{C}\to
\mathbb{C}$ defined as a sum of the power series
\begin{equation}\label{vhfffngghkjgghfjjghghghhjghjgghkghggkghghjghSYSPNjjjhhhjjkjkjklkjljljkkhhkjiijjihhjjhjhjhjhkkhhkcc}
f(z):=\sum_{m=0}^{+\infty}a_mz^m,
\end{equation}
with $a_m\in\mathbb{C}$, is such that for the operator $f\circ \hat
H_{\vec u}$, given by:
\begin{equation}\label{vhfffngghkjgghfjjghghghhjghjgghkghggkghghjghSYSPNjjjhhhjjkjkjklkjljljkkhhkjiijjihhjjhjhjhjhkkhhknnhkhhcc}
f\circ \hat H_{\vec u}:=\sum_{m=0}^{+\infty}a_m\hat H^m_{\vec u},
\end{equation}
where the operator $\hat H_{\vec u}$  is given by
\er{vhfffngghkjgghfjjghghghhjghjgghkghgghjhggjjkgSYSPNjjjljjkljjkjkjjz},
there exists a density matrix $\xi\left(\vec x_1,\ldots,\vec
x_n,\vec y_1,\ldots,\vec y_n,t\right)$, such that
\begin{equation}\label{vhfffngghkjgghfjjghghghhjghjgghkghggkghghjghSYSPNjjjhhhjjkjkjklkjljljkkhhkjiijjijhbmbgghgh11ggghhgfcc}
\left(f\circ\hat H_{\vec u}\right)\left(\vec x_1,\ldots,\vec
x_n,t\right)=\hat R_{\xi}\left(\vec x_1,\ldots,\vec x_n,t\right),
\end{equation}
where $\hat R_\xi$ is given by
\er{gyfyfgfgfgghZZHHhuggkjhjjggjhhhjhgg22jjjkjkkljjjkkjopiiiojjkljkjkjkljkjk}.
Then the operator $f\circ \hat H_{\vec u}$ satisfies:
\begin{equation}\label{vhfffngghkjgghfjjghghghhjghjgghkghggkghghjghSYSPNjjjhhhjjkjkjklkjljljkkhhkjiijjijhbmbgghghcc}
i\hbar\frac{\partial}{\partial t}\left(f\circ \hat H_{\vec
u}\right)= \hat H_0\cdot\left(f\circ \hat H_{\vec
u}\right)-\left(f\circ \hat H_{\vec u}\right)\cdot\hat H_0,
\end{equation}
or equivalently
\begin{multline}\label{vhfffngghkjgghfjjghghghhjghjgghkghggkghghjghSYSPNjjjhhhcc}
i\hbar\frac{\partial\xi}{\partial t}\left(\vec x_1,\ldots,\vec
x_n,\vec y_1,\ldots,\vec y_n,t\right)=\\ \hat H_0\left(\vec
x_1,\ldots,\vec x_n,t\right)\cdot\xi\left(\vec x_1,\ldots,\vec
x_n,\vec y_1,\ldots,\vec y_n,t\right)-\hat H^*_0\left(\vec
y_1,\ldots,\vec y_n,t\right)\cdot\xi\left(\vec x_1,\ldots,\vec
x_n,\vec y_1,\ldots,\vec y_n,t\right).
\end{multline}
Moreover, $f\circ \hat H_{\vec u}$ is invariant under the change of
inertial or non-inertial cartesian coordinate systems i.e.
\begin{multline}\label{vhfffngghkjgghfjjghghghhjghjgghkghgghjhggjjkgSYSPNjjjljjkljjkjkjjzjjjkkhhklhhlcc}
\left(f\circ\hat H'_{\vec u'}\right)\left(\vec x'_1,\ldots,\vec
x'_n,t'\right)\cdot\psi'\left(\vec x'_1,\ldots,\vec
x'_n,t'\right)=\phi'\left(\vec x'_1,\ldots,\vec x'_n,t'\right)
\quad\text{if and only if}\quad\\  \left(f\circ\hat H_{\vec
u}\right)\left(\vec x_1,\ldots,\vec x_n,t\right)\cdot\psi\left(\vec
x_1,\ldots,\vec x_n,t\right)=\phi\left(\vec x_1,\ldots,\vec
x_n,t\right),
\end{multline}
provided that we have
\er{yuythfgfyftydtydtydtyddyyyhhddhhhredPPN111hgghjgintintrrZZHHhkkhjhj}.
Finally, if we assume that the first and the second particles have
the same mass $m_1=m_2$ and the same charge $\sigma_1=\sigma_2$ and
moreover, if we assume that
$$V\left(\vec x_1,\vec x_2,\vec x_3,\ldots,\vec
x_n,t\right)=V\left(\vec x_2,\vec x_1,\vec x_3,\ldots,\vec
x_n,t\right)\quad\quad\forall\,\vec x_1,\vec x_2,\vec
x_3,\ldots,\vec x_n\in\mathbb{R}^3,\quad\forall t,$$ then
\begin{multline}\label{fyfgjguyiuiHHgjgjgghbmgjgfhfhhHHljjjhjhjhyyuyuhhhljjjkjjcc}
\psi(\vec x_2,\vec x_1,\vec x_3,\ldots,\vec x_n,t)=-\psi(\vec
x_1,\vec x_2,\vec x_3,\ldots,\vec x_n,t)\;\;\forall\,\vec
x_1,\ldots,\vec
x_n\in\mathbb{R}^3\quad\text{and}\quad \phi=\left(f\circ\hat H_{\vec u}\right)\cdot\psi\\
\quad\text{implies}\quad \phi(\vec x_2,\vec x_1,\vec x_3,\ldots,\vec
x_n,t)=-\phi(\vec x_1,\vec x_2,\vec x_3,\ldots,\vec
x_n,t)\;\;\forall\,\vec x_1,\ldots,\vec x_n\in\mathbb{R}^3,
\end{multline}
and
\begin{multline}\label{fyfgjguyiuiHHgjgjgghbmgjgfhfhhHHljjjhjhjhyyuyuhhhljjjkjjggcc}
\psi(\vec x_2,\vec x_1,\vec x_3,\ldots,\vec x_n,t)=\psi(\vec
x_1,\vec x_2,\vec x_3,\ldots,\vec x_n,t)\;\;\forall\,\vec
x_1,\ldots,\vec
x_n\in\mathbb{R}^3\quad\text{and}\quad \phi=\left(f\circ\hat H_{\vec u}\right)\cdot\psi\\
\quad\text{implies}\quad \phi(\vec x_2,\vec x_1,\vec x_3,\ldots,\vec
x_n,t)=\phi(\vec x_1,\vec x_2,\vec x_3,\ldots,\vec
x_n,t)\;\;\forall\,\vec x_1,\ldots,\vec x_n\in\mathbb{R}^3.
\end{multline}
\end{proposition}
\begin{proof}
Again by
\er{vhfffngghkjgghfjjghghghhjghjgghkghggSYSPNkljljjkklmjkjkjjk,mmm,kl;k;lklkjkjkkjkjkjkjljljjljklggghgkhhkhkhkkklkklklkljkhkhjjkhjhmnnmmn}
and Proposition \ref{jhjkhhjhjhgjgjjkjk} there exists another
cartesian coordinate system $(**)$ such that under the change of
coordinate system $(*)$ to another cartesian coordinate system
$(**)$, given by \er{noninchgravortbstrjgghguittu2hhhhh}, we have
\begin{equation}
\label{jljjjjkhhhjkkww}A(t)\cdot \vec u(\vec x,t)+
\frac{d A}{dt}(t)\cdot\vec x+
\frac{d\vec z}{dt}(t)=\vec u'(\vec x',t')=0.
\end{equation}
Thus, since
\er{vhfffngghkjgghfjjghghghhjghjgghkghggSYSPNkljljjkklmjkjkjjk,mmm,kl;k;lklkjkjkkjkjkjkjljljjljklggghgkhhkhkhkkklkklklkljkhkhjjkhjhmnnmmn},
\er{vhfffngghkjgghfjjghghghSYSPNjljll'lhkhhjgggjgjgghgjkjjkjkjk} are
invariant under the change of inertial or non-inertial cartesian
coordinate systems, as before, in system $(**)$ we have
\begin{equation}\label{vhfffngghkjgghfjjghghghhjghjgghkghggSYSPNkljljjkklmjkjkjjk,mmm,kl;k;lklkjkjkkjkjkjkjljljjljklggghgkhhkhkhkkklkklklkljkhkhjjkhjhmjmm,mjhjkjkhjhj}
\begin{cases}
\frac{\partial U'}{\partial t'}\left(\vec x'_1,\ldots,\vec
x'_n,t'\right)=0,\\
\frac{\partial\vec v'}{\partial t'}(\vec
x',t')=0,\\
\frac{\partial\vec A'}{\partial t'}(\vec x',t')=0
\\
\vec u'(\vec x',t')=0,
\end{cases}
\end{equation}
where
\begin{multline}\label{vhfffngghkjgghfjjghghghSYSPNjljll'lhkhhjgggjgjgghgjkjjkkjjhj}
U'\left(\vec x'_1,\ldots,\vec x'_n,t'\right):=\\
\sum_{j=1}^{n}\left(\frac{(\sigma'_j)^2}{2m'_jc^2}\left|\vec A'(\vec
x'_j,t')\right|^2+\sigma'_j\Psi'(\vec
x'_j,t')-\frac{\sigma'_j}{c}\vec A'(\vec x'_j,t')\cdot\vec v'(\vec
x'_j,t')\right)-V'\left(\vec x'_1,\ldots,\vec x'_n,t'\right).
\end{multline}
On the other hand,
\er{vhfffngghkjgghfjjghghghhjghjgghkghggSYSPNkljljjkklmjkjkjjk,mmm,kl;k;lklkjkjkkjkjkjkjljljjljklggghgkhhkhkhkkklkklklkljkhkhjjkhjhmjmm,mjhjkjkhjhj}
is equivalent to
\begin{equation}\label{vhfffnhhjhjjkkjhkjkhkh}
\frac{\partial \hat H'_0}{\partial t'}\left(\vec x'_1,\ldots,\vec
x'_n,t'\right)=0\quad\text{and}\quad\vec u'(\vec x',t')=0,
\end{equation}
in system $(**)$. Thus, by \er{vhfffnhhjhjjkkjhkjkhkh}, in system
$(**)$ we deduce:
\begin{equation}\label{vhfffngghkjgghfjjghghghhjghjgghkghggkghghjghSYSPNjjjhhhjjkjkjklkjljljkkhhkjjjkljjbbb}
\hat H'_0\cdot\hat H'_{\vec u'}-\hat H'_{\vec u'}\cdot\hat H'_0=\hat
H'_0\cdot\hat H'_0-\hat H'_0\cdot\hat
H'_0=0=i\hbar\frac{\partial\hat H'_0}{\partial
t'}=i\hbar\frac{\partial\hat H'_{\vec u'}}{\partial t'}.
\end{equation}
So we get:
\begin{equation}\label{vhfffngghkjgghfjjghghghhjghjgghkghggkghghjghSYSPNjjjhhhjjkjkjklkjljljkkhhkjjjkljjbbbjkljljl}
i\hbar\frac{\partial\hat H'_{\vec u'}}{\partial t'}=\hat
H'_0\cdot\hat H'_{\vec u'}-\hat H'_{\vec u'}\cdot\hat H'_0.
\end{equation}
Therefore, given the holomorphic function $f(z):\mathbb{C}\to
\mathbb{C}$ defined as a sum of the power series
\begin{equation}\label{vhfffngghkjgghfjjghghghhjghjgghkghggkghghjghSYSPNjjjhhhjjkjkjklkjljljkkhhkjiijjihhjjhjhjhjhkkhhk}
f(z):=\sum_{m=0}^{+\infty}a_mz^m,
\end{equation}
with $a_m\in\mathbb{C}$, if we define the operator $f\circ \hat
H_{\vec u}$ as:
\begin{equation}\label{vhfffngghkjgghfjjghghghhjghjgghkghggkghghjghSYSPNjjjhhhjjkjkjklkjljljkkhhkjiijjihhjjhjhjhjhkkhhknnhkhh}
f\circ \hat H_{\vec u}:=\sum_{m=0}^{+\infty}a_m\hat H^m_{\vec u},
\end{equation}
by
\er{vhfffngghkjgghfjjghghghhjghjgghkghggkghghjghSYSPNjjjhhhjjkjkjklkjljljkkhhkjjjkljjbbbjkljljl}
and Proposition \ref{hugyfyfffgh} we deduce
\begin{equation}\label{vhfffngghkjgghfjjghghghhjghjgghkghggkghghjghSYSPNjjjhhhjjkjkjklkjljljkkhhkjiijjijhbmbgghghdd}
i\hbar\frac{\partial}{\partial t'}\left(f\circ \hat H'_{\vec
u'}\right)= \hat H'_0\cdot\left(f\circ \hat H'_{\vec
u'}\right)-\left(f\circ \hat H'_{\vec u'}\right)\cdot\hat H'_0.
\end{equation}
Moreover, by
\er{vhfffngghkjgghfjjghghghhjghjgghkghgghjhggjjkgSYSPNjjjljjkljjkjkjjz}
and
\er{vhfffngghkjgghfjjghghghhjghjgghkghggkghghjghSYSPNjjjhhhjjkjkjklkjljljkkhhkjiijjihhjjhjhjhjhkkhhknnhkhh},
we can easily prove that $f\circ \hat H_{\vec u}$ is invariant under
the change of inertial or non-inertial cartesian coordinate systems
i.e.
\begin{multline}\label{vhfffngghkjgghfjjghghghhjghjgghkghgghjhggjjkgSYSPNjjjljjkljjkjkjjzjjjkkhhklhhl}
\left(f\circ\hat H'_{\vec u'}\right)\left(\vec x'_1,\ldots,\vec
x'_n,t'\right)\cdot\psi'\left(\vec x'_1,\ldots,\vec
x'_n,t'\right)=\phi'\left(\vec x'_1,\ldots,\vec x'_n,t'\right)
\quad\text{if and only if}\quad\\  \left(f\circ\hat H_{\vec
u}\right)\left(\vec x_1,\ldots,\vec x_n,t\right)\cdot\psi\left(\vec
x_1,\ldots,\vec x_n,t\right)=\phi\left(\vec x_1,\ldots,\vec
x_n,t\right),
\end{multline}
provided that, as before, we have
\er{yuythfgfyftydtydtydtyddyyyhhddhhhredPPN111hgghjgintintrrZZHHhkkhjhj}.
Next assume that the holomorphic function $f$ in
\er{vhfffngghkjgghfjjghghghhjghjgghkghggkghghjghSYSPNjjjhhhjjkjkjklkjljljkkhhkjiijjihhjjhjhjhjhkkhhk}
is such that there exists a density matrix $\xi\left(\vec
x_1,\ldots,\vec x_n,\vec y_1,\ldots,\vec y_n,t\right)$ satisfying
\begin{equation}\label{vhfffngghkjgghfjjghghghhjghjgghkghggkghghjghSYSPNjjjhhhjjkjkjklkjljljkkhhkjiijjijhbmbgghgh11ggghhgf11gg}
\left(f\circ\hat H_{\vec u}\right)\left(\vec x_1,\ldots,\vec
x_n,t\right)=\hat R_{\xi}\left(\vec x_1,\ldots,\vec x_n,t\right).
\end{equation}
Then, by
\er{vhfffngghkjgghfjjghghghhjghjgghkghgghjhggjjkgSYSPNjjjljjkljjkjkjjzjjjkkhhklhhl},
for the density matrix $\xi'\left(\vec x'_1,\ldots,\vec x'_n,\vec
y'_1,\ldots,\vec y'_n,t'\right)$ we have
\begin{equation}\label{vhfffngghkjgghfjjghghghhjghjgghkghggkghghjghSYSPNjjjhhhjjkjkjklkjljljkkhhkjiijjijhbmbgghgh11ggghhgf}
\left(f\circ\hat H'_{\vec u'}\right)\left(\vec x'_1,\ldots,\vec
x'_n,t'\right)=\hat R_{\xi'}\left(\vec x'_1,\ldots,\vec
x'_n,t'\right),
\end{equation}
provided that $\xi'=\xi$.
Therefore, since we obtained before, that equation
\er{vhfffngghkjgghfjjghghghhjghjgghkghggkghghjghSYSPNjjjhhhjjkjkjklkjljljkkhhkjiijjijhbmbgghghdd}
is equivalent to the primed version of
\er{vhfffngghkjgghfjjghghghhjghjgghkghggkghghjghSYSPNjjjhhh} and at
the same time equation
\er{vhfffngghkjgghfjjghghghhjghjgghkghggkghghjghSYSPNjjjhhh} is
invariant under the change of non-inertial cartesian coordinate
system, provided we have $\xi'=\xi$, with the help of
\er{vhfffngghkjgghfjjghghghhjghjgghkghggkghghjghSYSPNjjjhhhjjkjkjklkjljljkkhhkjiijjijhbmbgghgh11ggghhgf11gg},
as before, we deduce
\begin{equation}\label{vhfffngghkjgghfjjghghghhjghjgghkghggkghghjghSYSPNjjjhhhjjkjkjklkjljljkkhhkjiijjijhbmbgghgh}
i\hbar\frac{\partial}{\partial t}\left(f\circ \hat H_{\vec
u}\right)= \hat H_0\cdot\left(f\circ \hat H_{\vec
u}\right)-\left(f\circ \hat H_{\vec u}\right)\cdot\hat H_0
\end{equation}
in an arbitrary coordinate system. So we obtain
\er{vhfffngghkjgghfjjghghghhjghjgghkghggkghghjghSYSPNjjjhhhjjkjkjklkjljljkkhhkjiijjijhbmbgghghcc}
or equivalently
\er{vhfffngghkjgghfjjghghghhjghjgghkghggkghghjghSYSPNjjjhhhcc}.

Finally, assume that the first and the second particles have the
same mass $m_1=m_2$ and the same charge $\sigma_1=\sigma_2$ and
moreover, assume that we have
$$V\left(\vec x_1,\vec x_2,\vec x_3,\ldots,\vec
x_n,t\right)=V\left(\vec x_2,\vec x_1,\vec x_3,\ldots,\vec
x_n,t\right)\quad\quad\forall\,\vec x_1,\vec x_2,\vec
x_3,\ldots,\vec x_n\in\mathbb{R}^3,\quad\forall t.$$ In this case
clearly by
\er{vhfffngghkjgghfjjghghghhjghjgghkghgghjhggjjkgSYSPNjjjljjkljjkjkjjz}
we deduce the following relations:
\begin{multline}\label{fyfgjguyiuiHHgjgjgghbmgjgfhfhhHHljjjhjhjhyyuyuhhh11}
\psi(\vec x_2,\vec x_1,\vec x_3,\ldots,\vec x_n,t)=-\psi(\vec
x_1,\vec x_2,\vec x_3,\ldots,\vec x_n,t)\;\;\forall\,\vec
x_1,\ldots,\vec
x_n\in\mathbb{R}^3\quad\text{and}\quad \phi=\hat H_{\vec u}\cdot\psi\\
\quad\text{implies}\quad \phi(\vec x_2,\vec x_1,\vec x_3,\ldots,\vec
x_n,t)=-\phi(\vec x_1,\vec x_2,\vec x_3,\ldots,\vec
x_n,t)\;\;\forall\,\vec x_1,\ldots,\vec x_n\in\mathbb{R}^3,
\end{multline}
and
\begin{multline}\label{fyfgjguyiuiHHgjgjgghbmgjgfhfhhHHljjjhjhjhyyuyuhhh11gg}
\psi(\vec x_2,\vec x_1,\vec x_3,\ldots,\vec x_n,t)=\psi(\vec
x_1,\vec x_2,\vec x_3,\ldots,\vec x_n,t)\;\;\forall\,\vec
x_1,\ldots,\vec
x_n\in\mathbb{R}^3\quad\text{and}\quad \phi=\hat H_{\vec u}\cdot\psi\\
\quad\text{implies}\quad \phi(\vec x_2,\vec x_1,\vec x_3,\ldots,\vec
x_n,t)=\phi(\vec x_1,\vec x_2,\vec x_3,\ldots,\vec
x_n,t)\;\;\forall\,\vec x_1,\ldots,\vec x_n\in\mathbb{R}^3.
\end{multline}
Therefore, by
\er{vhfffngghkjgghfjjghghghhjghjgghkghggkghghjghSYSPNjjjhhhjjkjkjklkjljljkkhhkjiijjihhjjhjhjhjhkkhhknnhkhh}
and \er{fyfgjguyiuiHHgjgjgghbmgjgfhfhhHHljjjhjhjhyyuyuhhh11},
\er{fyfgjguyiuiHHgjgjgghbmgjgfhfhhHHljjjhjhjhyyuyuhhh11gg} we obtain
\begin{multline}\label{fyfgjguyiuiHHgjgjgghbmgjgfhfhhHHljjjhjhjhyyuyuhhhljjjkjj11}
\psi(\vec x_2,\vec x_1,\vec x_3,\ldots,\vec x_n,t)=-\psi(\vec
x_1,\vec x_2,\vec x_3,\ldots,\vec x_n,t)\;\;\forall\,\vec
x_1,\ldots,\vec
x_n\in\mathbb{R}^3\quad\text{and}\quad \phi=\left(f\circ\hat H_{\vec u}\right)\cdot\psi\\
\quad\text{implies}\quad \phi(\vec x_2,\vec x_1,\vec x_3,\ldots,\vec
x_n,t)=-\phi(\vec x_1,\vec x_2,\vec x_3,\ldots,\vec
x_n,t)\;\;\forall\,\vec x_1,\ldots,\vec x_n\in\mathbb{R}^3,
\end{multline}
and
\begin{multline}\label{fyfgjguyiuiHHgjgjgghbmgjgfhfhhHHljjjhjhjhyyuyuhhhljjjkjjgg}
\psi(\vec x_2,\vec x_1,\vec x_3,\ldots,\vec x_n,t)=\psi(\vec
x_1,\vec x_2,\vec x_3,\ldots,\vec x_n,t)\;\;\forall\,\vec
x_1,\ldots,\vec
x_n\in\mathbb{R}^3\quad\text{and}\quad \phi=\left(f\circ\hat H_{\vec u}\right)\cdot\psi\\
\quad\text{implies}\quad \phi(\vec x_2,\vec x_1,\vec x_3,\ldots,\vec
x_n,t)=\phi(\vec x_1,\vec x_2,\vec x_3,\ldots,\vec
x_n,t)\;\;\forall\,\vec x_1,\ldots,\vec x_n\in\mathbb{R}^3,
\end{multline}
consistently with
\er{fyfgjguyiuiHHgjgjgghbmgjgfhfhhHHljjjhjhjhyyuyuihhjjk} and
\er{fyfgjguyiuiHHgjgjgghbmgjgfhfhhHHljjjhjhjhyyuyuihhjhj}.
\end{proof}

\subsection{Shr\"{o}dinger-Pauli equation for a spin-half quantum
particle}\label{hjjghjghggh} Consider the motion of a spin-half
quantum micro-particle with inertial mass $m$ and the charge
$\sigma$ in the outer gravitational and electromagnetical field with
characteristics $\vec v(\vec x,t)$, $\vec A(\vec x,t)$ and
$\Psi(\vec x,t)$ and additional conservative field with potential
$V(\vec y,t)$. Since the Hamiltonian for a macro-particle has the
form
\begin{multline}\label{vhfffngghkjgghfjjghghghSYShmyuuiiuuhmhmiopoopnnkkllkkkZZ}
H_{\text{macro}}\left(\vec P,\vec r,t\right)=\\
\frac{m}{2}\left|\vec P-\frac{\sigma}{c}\vec A(\vec r,t)\right|^2
+\sigma\left(\Psi(\vec r,t)-\frac{1}{c}\vec v(\vec r,t)\cdot\vec
A(\vec r,t)\right)-V\left(\vec r,t\right)+\frac{1}{2}\vec v(\vec
r,t)\cdot\vec P+\frac{1}{2}\vec P\cdot\vec v(\vec r,t),
\end{multline}
we built the Hermitian Hamiltonian operator, taking into account the
spin interaction as
\begin{multline}\label{vhfffngghkjgghfjjghghghSYShmyuuiiuuhmhmiopoopnniukjhjkkkllkkkZZ}
\hat H_0\cdot\psi= -\frac{\hbar^2}{2m}\Delta_{\vec
x}\psi+\frac{i\hbar\sigma}{2mc}div_{\vec x}\left\{\psi\vec A(\vec
x,t)\right\}+\frac{i\hbar\sigma}{2mc}\nabla_{\vec x}\psi\cdot\vec
A(\vec x,t)+\frac{\sigma^2}{2mc^2}\left|\vec A(\vec
x,t)\right|^2\psi\\+\sigma\left(\Psi(\vec x,t)-\frac{1}{c}\vec
v(\vec x,t)\cdot\vec A(\vec x,t)\right)\psi-V\left(\vec
x,t\right)\psi-\frac{i\hbar}{2}div_{\vec x}\left\{\psi\vec v(\vec
x,t)\right\}-\frac{i\hbar}{2}\nabla_{\vec x}\psi\cdot\vec v(\vec
x,t)
\\-\frac{g\sigma\hbar}{2mc}\vec
S\cdot\left(curl_{\vec x}\vec A(\vec
x,t)\,\psi\right)+\frac{\hbar}{4}\vec S\cdot\left(curl_{\vec x}\vec
v(\vec x,t)\,\psi\right),
\end{multline}
where $\psi(\vec x,t)=\left(\psi_1(\vec x,t),\psi_2(\vec
x,t)\right)\in\mathbb{C}^2$ is a two-component wave function, $\hat
H_0$ is the Hamiltonian operator, $\vec S:=(S_1,S_2,S_3)$,
$$S_1=\left(\begin{matrix}0&1\\1&0\end{matrix}\right),\quad
S_2=\left(\begin{matrix}0&-i\\i&0\end{matrix}\right)\quad
S_3=\left(\begin{matrix}1&0\\0&-1\end{matrix}\right)$$ are Pauli
matrices and $g$ is a constant that depends on the type of the
particle (for electron we have $g=1$). Note that, in addition to the
classical term of the spin-magnetic interaction, we added another
term to the Hamiltonian, namely $\frac{\hbar}{4}\vec
S\cdot\left(curl_{\vec x}\vec v(\vec x,t)\,\psi\right)$. In the case
of the Newtonian-type gravity, this term vanishes in every
non-rotating and, in particular, in every inertial coordinate
system, however it provides the invariance of the
Shr\"{o}dinger-Pauli equation, under the change of non-inertial
cartesian coordinate system, as can be seen in the following Theorem
\ref{gjghghgghgintintrrZZ}. The Shr\"{o}dinger-Pauli equation for
this particle is
\begin{equation}\label{vhfffngghkjgghfjjghghghhjghjgghkghggkghghjghSYSPNnnkkllkkkZZ}
i\hbar\frac{\partial\psi}{\partial t}=\hat H_0\cdot\psi.
\end{equation}
 Thus,
\begin{multline}\label{vhfffngghkjgghfjjghghghSYShmyuuiiuuhmhmiopoopnniukjhjkk;l;lkhjjkihjjhkkkkkZZ}
i\hbar\frac{\partial\psi}{\partial
t}=-\frac{\hbar^2}{2m}\Delta_{\vec
x}\psi+\frac{i\hbar\sigma}{2mc}div_{\vec x}\left\{\psi\vec A(\vec
x,t)\right\}+\frac{i\hbar\sigma}{2mc}\nabla_{\vec x}\psi\cdot\vec
A(\vec x,t)+\frac{\sigma^2}{2mc^2}\left|\vec A(\vec
x,t)\right|^2\psi\\
%
%
%
+\sigma\left(\Psi(\vec x,t)-\frac{1}{c}\vec v(\vec x,t)\cdot\vec
A(\vec x,t)\right)\psi-V\left(\vec
x,t\right)\psi-\frac{i\hbar}{2}div_{\vec x}\left\{\psi\vec v(\vec
x,t)\right\}-\frac{i\hbar}{2}\nabla_{\vec x}\psi\cdot\vec v(\vec
x,t)\\+\frac{\hbar}{2}\vec S\cdot\left(\left(\frac{1}{2}curl_{\vec
x}\vec v(\vec x,t)-\frac{g\sigma}{mc}curl_{\vec x}\vec A(\vec
x,t)\right)\,\psi\right).
\end{multline}
I.e.
\begin{multline}\label{vhfffngghkjgghfjjghghghSYShmyuuiiuuhmhmiopoopnniukjhjkk;l;lkhjjkihjjhkkkkkjjjZZ}
i\hbar\left(\frac{\partial\psi}{\partial t}+\frac{1}{2}div_{\vec
x}\left\{\psi\vec v(\vec x,t)\right\}+\frac{1}{2}\nabla_{\vec
x}\psi\cdot\vec v(\vec x,t)\right)\\=-\frac{\hbar^2}{2m}\Delta_{\vec
x}\psi+\frac{i\hbar\sigma}{2mc}div_{\vec x}\left\{\psi\vec A(\vec
x,t)\right\}+\frac{i\hbar\sigma}{2mc}\nabla_{\vec x}\psi\cdot\vec
A(\vec x,t)+\frac{\sigma^2}{2mc^2}\left|\vec A(\vec
x,t)\right|^2\psi\\
%
%
%
+\sigma\left(\Psi(\vec x,t)-\frac{1}{c}\vec v(\vec x,t)\cdot\vec
A(\vec x,t)\right)\psi-V\left(\vec
x,t\right)\psi+\frac{\hbar}{2}\vec
S\cdot\left(\left(\frac{1}{2}curl_{\vec x}\vec v(\vec
x,t)-\frac{g\sigma}{mc}curl_{\vec x}\vec A(\vec
x,t)\right)\,\psi\right).
\end{multline}
\begin{theorem}\label{gjghghgghgintintrrZZ}
Consider that the change of some cartesian coordinate system $(*)$
to another cartesian coordinate system $(**)$ is given by
\begin{equation}\label{noninchgravortbstrjgghguittu2intrrrZZ}
\begin{cases}
\vec x'=A(t)\cdot\vec x+\vec z(t),\\
t'=t,
\end{cases}
\end{equation}
where $A(t)\in SO(3)$ is a rotation. Next, assume that in the
coordinate system $(**)$ we observe a validity of the
Shr\"{o}dinger-Pauli equation of the form:
\begin{multline}\label{MaxVacFull1ninshtrredPPNintrrZZ}
i\hbar\left(\frac{\partial\psi'}{\partial t'}+\frac{1}{2}div_{\vec
x'}\left\{\psi'\vec v'\right\}+\frac{1}{2}\nabla_{\vec
x'}\psi'\cdot\vec v'\right)=-\frac{\hbar^2}{2m'}\Delta_{\vec
x'}\psi'+\frac{i\hbar\sigma'}{2m'c}div_{\vec x'}\left\{\psi'\vec
A'\right\}+\frac{i\hbar\sigma'}{2m'c}\nabla_{\vec x'}\psi'\cdot\vec
A'\\+\frac{(\sigma')^2}{2m'c^2}\left|\vec A'\right|^2\psi'
%
%
%
+\sigma'\left(\Psi'-\frac{1}{c}\vec v'\cdot\vec
A'\right)\psi'-V'\psi'+\frac{\hbar}{2}\vec
S\cdot\left(\left(\frac{1}{2}curl_{\vec x'}\vec
v'-\frac{g'\sigma'}{m'c}curl_{\vec x'}\vec A'\right)\,\psi'\right),
\end{multline}
where $\psi\in\mathbb{C}^2$. Then in the coordinate system $(*)$ we
have the validity of Shr\"{o}dinger-Pauli equation of the same as
\er{MaxVacFull1ninshtrredPPNintrrZZ} form:
\begin{multline}\label{MaxVacFull1ninshtrhjkkredPPNintrrZZ}
i\hbar\left(\frac{\partial\psi}{\partial t}+\frac{1}{2}div_{\vec
x}\left\{\psi\vec v\right\}+\frac{1}{2}\nabla_{\vec x}\psi\cdot\vec
v\right)=-\frac{\hbar^2}{2m}\Delta_{\vec
x}\psi+\frac{i\hbar\sigma}{2mc}div_{\vec x}\left\{\psi\vec
A\right\}+\frac{i\hbar\sigma}{2mc}\nabla_{\vec x}\psi\cdot\vec
A\\+\frac{\sigma^2}{2mc^2}\left|\vec A\right|^2\psi
%
%
%
+\sigma\left(\Psi-\frac{1}{c}\vec v\cdot\vec
A\right)\psi-V\psi+\frac{\hbar}{2}\vec
S\cdot\left(\left(\frac{1}{2}curl_{\vec x}\vec
v-\frac{g\sigma}{mc}curl_{\vec x}\vec A\right)\,\psi\right),
\end{multline}
provided that
\begin{equation}\label{yuythfgfyftydtydtydtyddyyyhhddhhhredPPN111hgghjgintintrrZZ}
\begin{cases}
g'=g\\
V'=V,\\
\sigma'=\sigma,\\
m'=m,\\
\vec v'=A(t)\cdot \vec v+\frac{dA}{dt}(t)\cdot\vec x+\frac{d\vec z}{dt}(t),\\
\vec A'=A(t)\cdot \vec A,\\
\Psi'-\vec v'\cdot\vec A'=\Psi-\vec v\cdot\vec A,\\
\psi'=U(t)\cdot\psi,
\end{cases}
\end{equation}
where $U(t)\in SU(2)$ is some special unitary $2\times 2$ matrix
i.e. $U(t)\in\mathbb{C}^{2\times 2}$, $det\,U(t)=1$, $U(t)\cdot
U^*(t)=I$ where $U^*(t)$ is the Hermitian adjoint to $U(t)$ matrix:
$U^*(t):=\bar U(t)^T$ and $I$ is the identity $2\times 2$ matrix.
Moreover, $U(t)$ is characterized by:
\begin{equation}\label{gyfyfgfgfgghZZ}
U^*(t)\cdot\vec S\cdot U(t)=A(t)\cdot\vec S,
\end{equation}
that means
\begin{multline*}
\left(U^*(t)\cdot S_1\cdot U(t),U^*(t)\cdot S_2\cdot
U(t),U^*(t)\cdot S_3\cdot U(t)\right)=\\
\left(a_{11}(t)S_1+a_{12}(t)S_2+a_{13}(t)S_3\,,\,a_{21}(t)S_1+a_{22}(t)S_2+a_{23}(t)S_3\,,\,a_{31}(t)S_1+a_{32}(t)S_2+a_{33}(t)S_3\right),
\end{multline*}
where $A(t)=\left\{a_{mk}(t)\right\}_{\{1\leq m,k\leq 3\}}$.
\end{theorem}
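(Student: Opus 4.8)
The plan is to start from the Schr\"{o}dinger--Pauli equation \er{MaxVacFull1ninshtrredPPNintrrZZ} in the system $(**)$, substitute the transformation rules \er{yuythfgfyftydtydtydtyddyyyhhddhhhredPPN111hgghjgintintrrZZ}, and recover \er{MaxVacFull1ninshtrhjkkredPPNintrrZZ} in the system $(*)$. I would split the operator into three pieces acting on $\psi$: the ``material'' operator $\mathcal L[\psi]:=i\hbar\big(\partial_t\psi+\frac12 div_{\vec x}\{\psi\vec v\}+\frac12\vec v\cdot\nabla_{\vec x}\psi\big)$ on the left, the spinless Hamiltonian part $\mathcal H[\psi]$ (kinetic term, the two $\vec A$-coupling terms, the $|\vec A|^2$-term, the potential term $\sigma(\Psi-\frac1c\vec v\cdot\vec A)\psi$ and $-V\psi$), and the spin term $\mathcal S[\psi]:=\frac{\hbar}{2}\vec S\cdot\big((\frac12 curl_{\vec x}\vec v-\frac{g\sigma}{mc}curl_{\vec x}\vec A)\psi\big)$. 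Since $\psi'=U(t)\psi$ with $U(t)\in SU(2)$ \emph{spatially constant}, everything except the spin term is already covariance available from Proposition \ref{yghgjtgyrtrt}, so the whole proof reduces to a single identity.

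For the spinless part, $U(t)$ commutes with every purely spatial operation ($\nabla_{\vec x}$, $\Delta_{\vec x}$, $div_{\vec x}$), while $\vec A$ is a proper vector field and $|\vec A|^2$, $V$ and (by \er{yuythfgfyftydtydtydtyddyyyhhddhhhredPPN111hgghjgintintrrZZ}) $\Psi-\frac1c\vec v\cdot\vec A$ are proper scalars. Hence, exactly as in the scalar computation of subsection \ref{hggyugyuy1}, $\mathcal H$ is covariant, and applying it to $\psi'=U\psi$ gives $\mathcal H'[\psi']=U\,\mathcal H[\psi]$ at corresponding points. For $\mathcal L$ the spatial terms behave the same way, but the time derivative produces an extra contribution: using Proposition \ref{yghgjtgyrtrt}(iv) for the combination $\partial_t\psi+\frac12 div_{\vec x}\{\psi\vec v\}+\frac12\vec v\cdot\nabla_{\vec x}\psi$ together with $\partial_{t'}(U\psi)=U\partial_{t'}\psi+U'(t)\psi$ (valid since $t'=t$ and $U$ is space-independent), I obtain $\mathcal L'[\psi']=U\,\mathcal L[\psi]+i\hbar\,U'(t)\psi$. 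Substituting these into \er{MaxVacFull1ninshtrredPPNintrrZZ}, the target \er{MaxVacFull1ninshtrhjkkredPPNintrrZZ} follows if and only if
\[
\mathcal S'[U\psi]-i\hbar\,U'(t)\psi=U\,\mathcal S[\psi].
\]

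The spin part is where the matrix $U$ must do its work. Since $\vec A$ is proper, $curl_{\vec x'}\vec A'=A(t)\cdot curl_{\vec x}\vec A$; and since $\vec v$ is speed-like, computing the antisymmetric part of $d_{\vec x'}\vec v'$ from \er{NoIn5redbstruiytrr} gives $curl_{\vec x'}\vec v'=A(t)\cdot curl_{\vec x}\vec v+2\vec\Omega(t)$, where $\vec\Omega(t)$ is the axial vector of the antisymmetric matrix $A'(t)A^T(t)$. Writing $\vec c:=\frac12 curl_{\vec x}\vec v-\frac{g\sigma}{mc}curl_{\vec x}\vec A$ and using \er{gyfyfgfgfgghZZ} in the form $\vec S\cdot(A\vec c)=U(\vec S\cdot\vec c)U^*$ (which follows from $U^*S_mU=\sum_k a_{mk}S_k$ and $A^TA=I$), the rotated part collapses by $U^*U=I$ to $U\,\mathcal S[\psi]$, leaving
\[
\mathcal S'[U\psi]=U\,\mathcal S[\psi]+\frac{\hbar}{2}(\vec S\cdot\vec\Omega)\,U\psi.
\]
Thus the required identity is equivalent to the algebraic relation $i\hbar\,U'(t)=\frac{\hbar}{2}(\vec S\cdot\vec\Omega(t))\,U(t)$, i.e. $U'=-\frac{i}{2}(\vec S\cdot\vec\Omega)\,U$.

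The main obstacle, and the real content of the theorem, is to show that the matrix $U(t)$ determined by \er{gyfyfgfgfgghZZ} automatically solves this ODE, which is precisely the statement that the $SU(2)$-lift of a time-dependent rotation carries the correct spin connection. I would establish it by differentiating \er{gyfyfgfgfgghZZ} in $t$: setting $Y:=U'U^{*}$ (traceless and anti-Hermitian, since $U\in SU(2)$, hence $Y=-\frac{i}{2}\vec S\cdot\vec b$ for a unique real $\vec b(t)$), the derivative of $U^*S_mU=\sum_k a_{mk}S_k$ becomes $U^*[S_m,Y]U=\sum_k a'_{mk}S_k$. Inserting $Y=-\frac{i}{2}\sum_j b_jS_j$ and the Pauli relations $[S_m,S_j]=2i\sum_l\epsilon_{mjl}S_l$, then using \er{gyfyfgfgfgghZZ} once more and the linear independence of $S_1,S_2,S_3$, yields $a'_{mk}=\sum_{j,l}\epsilon_{mjl}b_j a_{lk}$, that is $A'A^T=[\vec b]_{\times}$; hence $\vec b=\vec\Omega$ and $U'=-\frac{i}{2}(\vec S\cdot\vec\Omega)U$, as needed (the sign/factor conventions in the two occurrences of $\vec\Omega$ match, since both are the axial vector of $A'A^T$). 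Existence of a smooth $U(t)\in SU(2)$ satisfying \er{gyfyfgfgfgghZZ} is guaranteed by the double cover $SU(2)\to SO(3)$ by choosing a lift at $t=0$ and extending smoothly. The Lagrangian invariance claimed after the theorem then follows by the same bookkeeping, once the spin term in \er{vhfffngghkjgghDDmmkkkZZ} is checked to be a proper scalar via the identity $\vec S\cdot(A\vec c)=U(\vec S\cdot\vec c)U^*$.
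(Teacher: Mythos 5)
Your proposal is correct and takes essentially the same approach as the paper's proof: covariance of all non-spin terms via Proposition \ref{yghgjtgyrtrt}, isolation of the two extra contributions ($i\hbar U'(t)\psi$ from the time derivative and the rotational term in $curl_{\vec x'}\vec v'$), the key algebraic identity obtained by differentiating \eqref{gyfyfgfgfgghZZ} and using the Pauli commutation relations, and existence/smoothness of the lift $U(t)$ from the double cover $SU(2)\to SO(3)$. The only difference is notational bookkeeping: you phrase the identity as $U'=-\frac{i}{2}\left(\vec S\cdot\vec\Omega\right)U$ with $\vec\Omega$ the axial vector of $A'(t)\cdot A^T(t)$, whereas the paper writes $4iU^*(t)\cdot\frac{dU}{dt}(t)=\vec S\cdot\left(curl_{\vec x}\left(A^T(t)\cdot\frac{dA}{dt}(t)\cdot\vec x\right)\right)$ using the axial vector of $A^T(t)\cdot\frac{dA}{dt}(t)$; the two statements are equivalent under the intertwining relation \eqref{gyfyfgfgfgghZZ}.
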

\begin{proof}
It is well known that we have
\begin{equation}\label{gyjyfghfZZ}
\begin{cases}
S^2_1=S^2_2=S^2_3=I,\\
S_1\cdot S_2=-S_2\cdot S_1=iS_3,\quad S_2\cdot S_3=-S_3\cdot
S_2=iS_1,\quad S_3\cdot S_1=-S_1\cdot S_3=iS_2.
\end{cases}
\end{equation}
Next, it is well known that $SO(3)$ is smoothly double covered by
$SU(2)$ and the cover mapping is regular and locally one to one. In
particular, for every $t$ there exists $U(t)\in SU(2)$ such that
\er{gyfyfgfgfgghZZ} is satisfied
(note that the seconde choice is $(-U(t))$). Moreover, by Implicit
Function Theorem we deduce that if $A(t)$ is differentiable by $t$
then $U(t)$ is also differentiable by $t$. Thus if
$\psi'=U(t)\cdot\psi$ in \er{MaxVacFull1ninshtrredPPNintrrZZ}, then
by \er{MaxVacFull1ninshtrredPPNintrrZZ},
\er{yuythfgfyftydtydtydtyddyyyhhddhhhredPPN111hgghjgintintrrZZ} and
proposition \ref{yghgjtgyrtrt}
we have:
\begin{multline}\label{MaxVacFull1ninshtrredPPNintrrghghZZ}
i\hbar
U(t)\cdot\left(U^{-1}(t)\cdot\frac{dU}{dt}(t)\cdot\psi+\frac{\partial\psi}{\partial
t}+\frac{1}{2}div_{\vec x}\left\{\psi\vec
v\right\}+\frac{1}{2}\nabla_{\vec x}\psi\cdot\vec v\right)
=\\U(t)\cdot\left(-\frac{\hbar^2}{2m}\Delta_{\vec
x}\psi+\frac{i\hbar\sigma}{2mc}div_{\vec x}\left\{\psi\vec
A\right\}+\frac{i\hbar\sigma}{2mc}\nabla_{\vec x}\psi\cdot\vec
A+\frac{\sigma^2}{2mc^2}\left|\vec A\right|^2\psi
%
%
%
+\sigma\left(\Psi-\frac{1}{c}\vec v\cdot\vec
A\right)\psi-V\psi\right)\\+\frac{\hbar}{2}\vec
S\cdot\left(A(t)\cdot\left(\frac{1}{2}curl_{\vec
x}\left(A^{-1}(t)\cdot\frac{dA}{dt}(t)\cdot \vec
x\right)+\frac{1}{2}curl_{\vec x}\vec v-\frac{g\sigma}{mc}curl_{\vec
x}\vec A\right)\,U(t)\cdot\psi\right).
\end{multline}
Thus, since $U(t)$ is unitary and then $U^{-1}(t)=U^*(t)$ and
$A^{-1}(t)=A^T(t)$, by \er{MaxVacFull1ninshtrredPPNintrrghghZZ} we
have
\begin{multline}\label{MaxVacFull1ninshtrredPPNintrrghghghghZZ}
\frac{\hbar}{4}\left(4iU^*(t)\cdot\frac{dU}{dt}(t)\cdot\psi-
U^*(t)\cdot\vec S\cdot U(t)\cdot\left(A(t)\cdot\left(curl_{\vec
x}\left(A^T(t)\cdot\frac{dA}{dt}(t)\cdot \vec
x\right)\right)\psi\right)\right)
\\+i\hbar
\left(\frac{\partial\psi}{\partial t}+\frac{1}{2}div_{\vec
x}\left\{\psi\vec v\right\}+\frac{1}{2}\nabla_{\vec x}\psi\cdot\vec
v\right) =\\-\frac{\hbar^2}{2m}\Delta_{\vec
x}\psi+\frac{i\hbar\sigma}{2mc}div_{\vec x}\left\{\psi\vec
A\right\}+\frac{i\hbar\sigma}{2mc}\nabla_{\vec x}\psi\cdot\vec
A+\frac{\sigma^2}{2mc^2}\left|\vec A\right|^2\psi
%
%
%
+\sigma\left(\Psi-\frac{1}{c}\vec v\cdot\vec
A\right)\psi-V\psi\\+\frac{\hbar}{2}U^*(t)\cdot\vec S\cdot
U(t)\cdot\left(A(t)\cdot\left(\frac{1}{2}curl_{\vec x}\vec
v-\frac{g\sigma}{mc}curl_{\vec x}\vec A\right)\,\psi\right).
\end{multline}
Thus by \er{MaxVacFull1ninshtrredPPNintrrghghghghZZ} and
\er{gyfyfgfgfgghZZ} we deduce:
\begin{multline}\label{MaxVacFull1ninshtrredPPNintrrghghghghuhjihZZ}
\frac{\hbar}{4}\left(4iU^*(t)\cdot\frac{dU}{dt}(t)\cdot\psi- \vec
S\cdot \left(curl_{\vec x}\left(A^T(t)\cdot\frac{dA}{dt}(t)\cdot
\vec x\right)\right)\psi\right)
\\+i\hbar
\left(\frac{\partial\psi}{\partial t}+\frac{1}{2}div_{\vec
x}\left\{\psi\vec v\right\}+\frac{1}{2}\nabla_{\vec x}\psi\cdot\vec
v\right) =\\-\frac{\hbar^2}{2m}\Delta_{\vec
x}\psi+\frac{i\hbar\sigma}{2mc}div_{\vec x}\left\{\psi\vec
A\right\}+\frac{i\hbar\sigma}{2mc}\nabla_{\vec x}\psi\cdot\vec
A+\frac{\sigma^2}{2mc^2}\left|\vec A\right|^2\psi
%
%
%
+\sigma\left(\Psi-\frac{1}{c}\vec v\cdot\vec
A\right)\psi-V\psi\\+\frac{\hbar}{2}\vec S\cdot
\left(\left(\frac{1}{2}curl_{\vec x}\vec
v-\frac{g\sigma}{mc}curl_{\vec x}\vec A\right)\,\psi\right).
\end{multline}
On the other hand differentiating the identities $A^T(t)\cdot
A(t)=I$ and $U^*(t)\cdot U(t)=I$ by $t$ we deduce that the real
$3\times 3$ matrix $A^T(t)\cdot\frac{dA}{dt}(t)$ is antisymmetric
and the complex $2\times 2$ matrix $iU^*(t)\cdot\frac{dU}{dt}(t)$ is
Hermitian self-adjoint. Moreover, differentiating the identity
$det\,U(t)=1$ and using that $U^*(t)=U^{-1}(t)$ we deduce that the
matrix $iU^*(t)\cdot\frac{dU}{dt}(t)$ is traceless. In particular,
since $A^T(t)\cdot\frac{dA}{dt}(t)$ is antisymmetric, there exists
$\vec w(t)=(w_1(t),w_2(t),w_3(t))\in\mathbb{R}^3$ such that
\begin{equation}\label{gyfyfgfgfgghuyyuyyuZZ}
A^T(t)\cdot\frac{dA}{dt}(t)\cdot \vec x=\vec w(t)\times\vec
x\quad\quad\forall\vec x\in\mathbb{R}^3,
\end{equation}
and then,
\begin{equation}\label{gyfyfgfgfgghuyyuyyuuyhjjhZZ}
curl_{\vec x}\left(A^T(t)\cdot\frac{dA}{dt}(t)\cdot \vec
x\right)=2\vec w(t).
\end{equation}
On the other hand, since the matrix $iU^*(t)\cdot\frac{dU}{dt}(t)$
is Hermitian self-adjoint and traceless, clearly there exist $\vec
q(t)=(q_1(t),q_2(t),q_3(t))\in\mathbb{R}^3$ such that
\begin{equation}\label{gyfyfgfgfgghuyyuyyuuyhjjhgfhhjghZZ}
iU^*(t)\cdot\frac{dU}{dt}(t)=\vec q(t)\cdot\vec
S:=q_1(t)S_1+q_2(t)S_2+q_3(t)S_3.
\end{equation}
Finally differentiating the identity \er{gyfyfgfgfgghZZ} by $t$ we
deduce
\begin{equation}\label{gyfyfgfgfgghyuyyugghhgZZ}
\frac{dU^*}{dt}(t)\cdot\vec S\cdot U(t)+U^*(t)\cdot\vec S\cdot
\frac{dU}{dt}(t)=\frac{dA}{dt}(t)\cdot\vec S,
\end{equation}
and then again inserting \er{gyfyfgfgfgghZZ} into
\er{gyfyfgfgfgghyuyyugghhgZZ} and using the antisymmetry of
$A^T(t)\cdot\frac{dA}{dt}(t)$ and the Hermitian property of
$iU^*(t)\cdot\frac{dU}{dt}(t)$ we obtain
\begin{equation}\label{gyfyfgfgfgghyuyyugghhgghghhhjZZ}
-i\left(\left(A(t)\cdot\vec S\right)\cdot \left(iU^*(t)\cdot
\frac{dU}{dt}(t)\right)-\left(i\cdot
U^*(t)\cdot\frac{dU}{dt}(t)\right)\cdot \left(A(t)\cdot\vec
S\right)\right)=A(t)\cdot\left(A^T(t)\cdot\frac{dA}{dt}(t)\right)\cdot\vec
S,
\end{equation}
that implies:
\begin{equation}\label{gyfyfgfgfgghyuyyugghhgghghhhjuytytytyZZ}
-i\left(\vec S\cdot \left(iU^*(t)\cdot
\frac{dU}{dt}(t)\right)-\left(i\cdot
U^*(t)\cdot\frac{dU}{dt}(t)\right)\cdot \vec
S\right)=\left(A^T(t)\cdot\frac{dA}{dt}(t)\right)\cdot\vec S,
\end{equation}
Then inserting \er{gyfyfgfgfgghuyyuyyuZZ} and
\er{gyfyfgfgfgghuyyuyyuuyhjjhgfhhjghZZ} into
\er{gyfyfgfgfgghyuyyugghhgghghhhjuytytytyZZ} we deduce
\begin{equation}\label{gyfyfgfgfgghyuyyugghhgghghhhjuytytytytytyyuuyZZ}
-i\left(\vec S\cdot
\left(q_1(t)S_1+q_2(t)S_2+q_3(t)S_3\right)-\left(q_1(t)S_1+q_2(t)S_2+q_3(t)S_3\right)\cdot
\vec S\right)=\vec w(t)\times\vec S.
\end{equation}
Thus by \er{gyjyfghfZZ} and
\er{gyfyfgfgfgghyuyyugghhgghghhhjuytytytytytyyuuyZZ} we get
\begin{equation}\label{gyfyfgfgfgghyuyyugghhgghghhhjuytytytytytyyuuyghhgZZ}
2\vec q(t)=\vec w(t).
\end{equation}
Therefore, \er{gyfyfgfgfgghyuyyugghhgghghhhjuytytytytytyyuuyghhgZZ},
\er{gyfyfgfgfgghuyyuyyuuyhjjhgfhhjghZZ} and
\er{gyfyfgfgfgghuyyuyyuuyhjjhZZ} together imply:
\begin{equation}\label{gyfyfgfgfgghyuyyugghhgghghhhjuytytytytytyyuuyghhgjhjhZZ}
4iU^*(t)\cdot\frac{dU}{dt}(t)=\vec S\cdot \left(curl_{\vec
x}\left(A^T(t)\cdot\frac{dA}{dt}(t)\cdot \vec x\right)\right),
\end{equation}
and inserting
\er{gyfyfgfgfgghyuyyugghhgghghhhjuytytytytytyyuuyghhgjhjhZZ} into
\er{MaxVacFull1ninshtrredPPNintrrghghghghuhjihZZ} we finally
conclude \er{MaxVacFull1ninshtrhjkkredPPNintrrZZ}.
\end{proof}

Next, again consider the motion of a quantum micro-particle with
spin-half, inertial mass $m$ and the charges $\sigma$ with the known
gravitational and electromagnetical field with characteristics $\vec
v(\vec x,t)$, $\vec A(\vec x,t)$ and $\Psi(\vec x,t)$ and additional
conservative field with potential $V(\vec y,t)$, taking into the
account spin interaction. Then consider a Lagrangian density $L$
defined by
\begin{multline}\label{vhfffngghkjgghDDmmkkkZZ}
L\left(\psi,\vec x,t\right):=
\frac{i\hbar}{2}\left(\left(\frac{\partial\psi}{\partial t}+\vec
v\cdot\nabla_{\vec
x}\psi\right)\cdot\bar\psi-\psi\cdot\left(\frac{\partial\bar\psi}{\partial
t}+\vec v\cdot\nabla_{\vec
x}\bar\psi\right)\right)-\frac{\hbar^2}{2m}\nabla_{\vec
x}\psi\cdot\nabla_{\vec x}\bar\psi+V\left(\vec
x,t\right)\psi\cdot\bar\psi\\
-\frac{\hbar\sigma i}{2mc}\left(\nabla_{\vec
x}\psi\cdot\bar\psi-\psi\cdot\nabla_{\vec x}\bar\psi\right)\cdot\vec
A-\frac{\sigma^2}{2mc^2}\left|\vec A\right|^2\psi\cdot\bar\psi
-\sigma\left(\Psi-\frac{1}{c}\vec v\cdot\vec
A\right)\psi\cdot\bar\psi\\ -\frac{\hbar}{2}\left(\left(\vec
S\cdot\left(\frac{1}{2}curl_{\vec x}\vec
v-\frac{g\sigma}{mc}curl_{\vec x}\vec
A\right)\right)\cdot\psi\right)\cdot\bar\psi,
\end{multline}
where $\psi\in \mathbb{C}^2$ is a two-component wave function. Then
similarly to the proof of Theorem \ref{gjghghgghgintintrrZZ} we can
prove that $L$ is invariant under the change of inertial or
non-inertial cartesian coordinate system, given by
\er{noninchgravortbstrjgghguittu2intrrrZZ}, provided that we take
into account
\er{yuythfgfyftydtydtydtyddyyyhhddhhhredPPN111hgghjgintintrrZZ}. We
investigate stationary points of the functional
\begin{equation}\label{btfffygtgyggyDDmmkkkZZ}
J=\int_0^T\int_{\mathbb{R}^3}L\left(\psi,\vec x,t\right)d\vec x dt.
\end{equation}
Then, by \er{vhfffngghkjgghDDmmkkkZZ} we have
\begin{multline}\label{vhfffngghkjgghDDmmkkkghgZZ}
0=\frac{\delta L}{\delta (\bar\psi)}=
i\hbar\left(\frac{\partial\psi}{\partial t}+\frac{1}{2}\vec
v\cdot\nabla_{\vec x}\psi+\frac{1}{2}div_{\vec x}\left\{\psi\vec
v\right\}\right)+\frac{\hbar^2}{2m}\Delta_{\vec x}\psi+V\left(\vec
x,t\right)\psi\\
-\frac{\hbar\sigma i}{2mc}\left(\vec A\cdot\nabla_{\vec
x}\psi+div_{\vec x}\left\{\psi\vec
A\right\}\right)-\frac{\sigma^2}{2mc^2}\left|\vec A\right|^2\psi
-\sigma\left(\Psi-\frac{1}{c}\vec v\cdot\vec A\right)\psi
\\-\frac{\hbar}{2}\left(\vec S\cdot\left(\frac{1}{2}curl_{\vec x}\vec
v-\frac{g\sigma}{mc}curl_{\vec x}\vec A\right)\right)\cdot\psi,
\end{multline}
and
\begin{multline}\label{vhfffngghkjgghDDmmkkkghguhyuyZZ}
0=\frac{\delta L}{\delta (\psi)}= \bar
{(i)}\hbar\left(\frac{\partial\bar\psi}{\partial t}+\frac{1}{2}\vec
v\cdot\nabla_{\vec x}\bar\psi+\frac{1}{2}div_{\vec
x}\left\{\bar\psi\vec
v\right\}\right)+\frac{\hbar^2}{2m}\Delta_{\vec x}\bar\psi
+V\left(\vec
x,t\right)\bar\psi\\
-\frac{\hbar\sigma \bar {(i)}}{2mc}\left(\vec A\cdot\nabla_{\vec
x}\bar\psi+div_{\vec x}\left\{\bar\psi\vec
A\right\}\right)-\frac{\sigma^2}{2mc^2}\left|\vec A\right|^2\bar\psi
-\sigma\left(\Psi-\frac{1}{c}\vec v\cdot\vec A\right)\bar\psi
\\-\frac{\hbar}{2}\left(\vec S^T\cdot\left(\frac{1}{2}curl_{\vec x}\vec
v-\frac{g\sigma}{mc}curl_{\vec x}\vec A\right)\right)\cdot\bar\psi.
\end{multline}
Note that the last equality is just the complex conjugate of
equality \er{vhfffngghkjgghDDmmkkkghgZZ}. So we get that the
Euler-Lagrange equation for \er{vhfffngghkjgghDDmmkkkZZ} coincides
with the Shr\"{o}dinger-Pauli equation in the form of
\er{vhfffngghkjgghfjjghghghSYShmyuuiiuuhmhmiopoopnniukjhjkk;l;lkhjjkihjjhkkkkkjjjZZ}.

\subsection{Shr\"{o}dinger-Pauli equation for a system of $n$
spin-half micro-particles}\label{ghghggjghfgjk}
Consider the motion
of a system of $n$ spin-half quantum micro-particles with inertial
masses $m_1,\ldots,m_n$ and the charges $\sigma_1,\ldots,\sigma_n$
in the outer gravitational and electromagnetical field with
characteristics $\vec v(\vec x,t)$, $\vec A(\vec x,t)$ and
$\Psi(\vec x,t)$ and additional conservative field with potential
$V(\vec y_1,\ldots,\vec y_n,t)$, taking into account the spin
interaction. Then the system is characterized by $2^n$-component
complex wave function $\psi(\vec x_1,\ldots,\vec
x_n,t)\in\mathbb{C}^{2^n}$ where by $\mathbb{C}^{2^n}$ we denote the
tensor product of $n$ copies of the space $\mathbb{C}^2$:
\begin{equation}\label{fyfgjguyiuiHH}
\mathbb{C}^{2^n}:=\left(\mathbb{C}^2\right)\otimes_1\left(\mathbb{C}^2\right)\otimes_2\left(\mathbb{C}^2\right)
\ldots\otimes_{(n-1)}\left(\mathbb{C}^2\right)
\end{equation}
Then we built the Hermitian Hamiltonian operator as:
\begin{multline}\label{vhfffngghkjgghfjjghghghhjghjgghkghgghjhggjjkgSYSPNHH}
\hat H_0\cdot\psi=-\sum_{j=1}^{n}\frac{i\hbar}{2}div_{\vec
x_j}\left\{\psi\vec v(\vec
x_j,t)\right\}-\sum_{j=1}^{n}\frac{i\hbar}{2}\vec v(\vec
x_j,t)\cdot\nabla_{\vec x_j}\psi\\+
\sum_{j=1}^{n}\left\{\frac{1}{2m_j}\left(-i\hbar\nabla_{\vec
x_j}-\frac{\sigma_j}{c}\vec A(\vec
x_j,t)\right)\circ\left(-i\hbar\nabla_{\vec
x_j}-\frac{\sigma_j}{c}\vec A(\vec
x_j,t)\right)\right\}\cdot\psi\\+\sum_{j=1}^{n}\sigma_j\left(\Psi(\vec
x_j,t)-\frac{1}{c}\vec A(\vec x_j,t)\cdot\vec v(\vec
x_j,t)\right)\cdot\psi-V\left(\vec x_1,\ldots,\vec
x_n,t\right)\cdot\psi
\\-\sum_{j=1}^{n}\frac{g_j\sigma_j\hbar}{2m_jc}\vec
S_j\cdot\left(curl_{\vec x_j}\vec A(\vec
x_j,t)\,\psi\right)+\sum_{j=1}^{n}\frac{\hbar}{4}\vec
S_j\cdot\left(curl_{\vec x_j}\vec v(\vec x_j,t)\,\psi\right)=
-\sum_{j=1}^{n}\frac{\hbar^2}{2m_j}\Delta_{\vec
x_j}\psi\\-\sum_{j=1}^{n}\frac{i\hbar}{2}div_{\vec
x_j}\left\{\psi\vec v(\vec
x_j,t)\right\}-\sum_{j=1}^{n}\frac{i\hbar}{2}\vec v(\vec
x_j,t)\cdot\nabla_{\vec x_j}\psi+\sum_{j=1}^{n}\frac{
i\hbar\sigma_j}{2m_jc}div_{\vec x_j}\left\{\psi\vec A(\vec
x_j,t)\right\}+\sum_{j=1}^{n}\frac{ i\hbar\sigma_j}{2m_jc}\vec
A(\vec x_j,t)\cdot\nabla_{\vec
x_j}\psi\\+\sum_{j=1}^{n}\left(\sigma_j\Psi(\vec
x_j,t)-\frac{\sigma_j}{c}\vec A(\vec x_j,t)\cdot\vec v(\vec
x_j,t)+\frac{\sigma^2_j}{2m_jc^2}\left|\vec A(\vec
x_j,t)\right|^2\right)\psi-V\left(\vec x_1,\ldots,\vec
x_n,t\right)\psi
\\-\sum_{j=1}^{n}\frac{g_j\sigma_j\hbar}{2m_jc}\vec
S_j\cdot\left(curl_{\vec x_j}\vec A(\vec
x_j,t)\,\psi\right)+\sum_{j=1}^{n}\frac{\hbar}{4}\vec
S_j\cdot\left(curl_{\vec x_j}\vec v(\vec x_j,t)\,\psi\right),
\end{multline}
where $\hat H_0$ is the Hamiltonian operator, for every
$j=1,\ldots,n$ $g_j$ is a constant that depends on the type of the
particle (for electron we have $g_j=1$), and for every $j=1,\ldots,
n$ we denote
\begin{equation}\label{fyfgjguyiuiHHHHHH}
\vec S_j:=(S^j_1,S^j_2,S^j_3)\quad\quad\forall\, j=1,2,\ldots,n,
\end{equation}
where for every $k=1,2,3$ and every $j=1,2,\ldots,n$:
$S^j_k:\mathbb{C}^{2^n}\to \mathbb{C}^{2^n}$ is a linear operator on
$\mathbb{C}^{2^n}$ (i.e. it is a $2^n\times 2^n$-complex matrix)
defined by the following identities:
\begin{multline}\label{fyfgjguyiuiHHHHHHHHHHuiu}
S^1_k:=\left(S_k\right)\otimes_{1}\left(I^{2\times
2}\right)\otimes_{2}\left(I^{2\times
2}\right)\ldots\otimes_{(n-1)}\left(I^{2\times
2}\right),\quad\quad\ldots\quad
\\
S^j_k:=\left(I^{2\times 2}\right)\otimes_1\left(I^{2\times
2}\right)\otimes_2\left(I^{2\times 2}\right)
\ldots\otimes_{(j-1)}\left(S_k\right)\otimes_{j}\left(I^{2\times
2}\right)\otimes_{(j+1)}\left(I^{2\times
2}\right)\ldots\otimes_{(n-1)}\left(I^{2\times 2}\right),
\\
\quad\quad\ldots\quad\text{and}\quad\quad S^n_k:=\left(I^{2\times
2}\right)\otimes_1\left(I^{2\times
2}\right)\otimes_2\left(I^{2\times 2}\right)
\ldots\otimes_{(n-1)}\left(S_k\right),
\end{multline}
Here $S_k$ for $k=1,2,3$ are Pauli matrixes defined as:
$$S_1=\left(\begin{matrix}0&1\\1&0\end{matrix}\right),\quad
S_2=\left(\begin{matrix}0&-i\\i&0\end{matrix}\right)\quad
S_3=\left(\begin{matrix}1&0\\0&-1\end{matrix}\right)$$ and the sign
$\otimes$ in \er{fyfgjguyiuiHHHHHHHHHHuiu} means the tensor product
of the matrices, i.e. for given two linear operators
$A:\mathbb{C}^{p}\to\mathbb{C}^{p}$ and
$B:\mathbb{C}^{q}\to\mathbb{C}^{q}$
their tensor product
$A\otimes B$ is a linear operator from
$\mathbb{C}^{p}\otimes\mathbb{C}^{q}$ to
$\mathbb{C}^{p}\otimes\mathbb{C}^{q}$, defined by the identity:
\begin{equation}\label{fyfgjguyiuiHHHHHHkjjjjl}
(A\otimes B)\cdot(a\otimes b)=(A\cdot a)\otimes (B\cdot
b)\quad\quad\forall a\in\mathbb{C}^{p},\;\forall b\in\mathbb{C}^{q}.
\end{equation}
Note that, in addition to the classical term of the spin-magnetic
interaction, we added another term to the Hamiltonian, namely
$\sum_{j=1}^{n}\frac{\hbar}{4}\vec S_j\cdot\left(curl_{\vec x_j}\vec
v(\vec x_j,t)\,\psi\right)$. In the case of the Newtonian-type
gravity, this term vanishes in every non-rotating and, in
particular, in every inertial coordinate system, however it provides
the invariance of the Shr\"{o}dinger-Pauli equation, under the
change of non-inertial cartesian coordinate system, as can be seen
in the following Theorem \ref{gjghghgghgintintrrZZHH}. Thus the
corresponding Shr\"{o}dinger-Pauli equation will be the following:
\begin{multline}\label{vhfffngghkjgghfjjghghghhjghjgghkghgghjhggjjkgfgdSYSPNHH}
i\hbar\frac{\partial\psi}{\partial t}=\hat H_0\cdot\psi=
-\sum_{j=1}^{n}\frac{\hbar^2}{2m_j}\Delta_{\vec
x_j}\psi-\sum_{j=1}^{n}\frac{i\hbar}{2}div_{\vec x_j}\left\{\psi\vec
v(\vec x_j,t)\right\}-\sum_{j=1}^{n}\frac{i\hbar}{2}\vec v(\vec
x_j,t)\cdot\nabla_{\vec x_j}\psi\\+\sum_{j=1}^{n}\frac{
i\hbar\sigma_j}{2m_jc}div_{\vec x_j}\left\{\psi\vec A(\vec
x_j,t)\right\}+\sum_{j=1}^{n}\frac{ i\hbar\sigma_j}{2m_jc}\vec
A(\vec x_j,t)\cdot\nabla_{\vec
x_j}\psi\\+\sum_{j=1}^{n}\left(\sigma_j\Psi(\vec
x_j,t)-\frac{\sigma_j}{c}\vec A(\vec x_j,t)\cdot\vec v(\vec
x_j,t)+\frac{\sigma^2_j}{2m_jc^2}\left|\vec A(\vec
x_j,t)\right|^2\right)\psi-V\left(\vec x_1,\ldots,\vec
x_n,t\right)\psi\\-\sum_{j=1}^{n}\frac{g_j\sigma_j\hbar}{2m_jc}\vec
S_j\cdot\left(curl_{\vec x_j}\vec A(\vec
x_j,t)\,\psi\right)+\sum_{j=1}^{n}\frac{\hbar}{4}\vec
S_j\cdot\left(curl_{\vec x_j}\vec v(\vec x_j,t)\,\psi\right).
\end{multline}
So
%
%
%
%
%
%
\begin{multline}\label{vhfffngghkjgghfjjghghghhjghjgghkghgghjhggjjkgfgdiyfgfjkjgjgSYSPNHH}
i\hbar\left(\frac{\partial\psi}{\partial t}+\sum_{j=1}^{n}\vec
v(\vec x_j,t)\cdot\nabla_{\vec
x_j}\psi\right)+\sum_{j=1}^{n}\frac{i\hbar}{2}\left(div_{\vec
x_j}\vec v(\vec x_j,t)\right)\psi=
-\sum_{j=1}^{n}\frac{\hbar^2}{2m_j}\Delta_{\vec x_j}\psi-V\left(\vec
x_1,\ldots,\vec x_n,t\right)\psi\\+\sum_{j=1}^{n}\frac{
i\hbar\sigma_j}{2m_jc}div_{\vec x_j}\left\{\psi\vec A(\vec
x_j,t)\right\}+\sum_{j=1}^{n}\frac{ i\hbar\sigma_j}{2m_jc}\vec
A(\vec x_j,t)\cdot\nabla_{\vec
x_j}\psi\\+\sum_{j=1}^{n}\left(\sigma_j\Psi(\vec
x_j,t)-\frac{\sigma_j}{c}\vec A(\vec x_j,t)\cdot\vec v(\vec
x_j,t)+\frac{\sigma^2_j}{2m_jc^2}\left|\vec A(\vec
x_j,t)\right|^2\right)\psi\\-\sum_{j=1}^{n}\frac{g_j\sigma_j\hbar}{2m_jc}\vec
S_j\cdot\left(curl_{\vec x_j}\vec A(\vec
x_j,t)\,\psi\right)+\sum_{j=1}^{n}\frac{\hbar}{4}\vec
S_j\cdot\left(curl_{\vec x_j}\vec v(\vec x_j,t)\,\psi\right).
\end{multline}
Then in the similar way as the proof of Theorem
\ref{gjghghgghgintintrrZZ} we can prove the following more general
Theorem about the invariance of the Shr\"{o}dinger-Pauli equation
\er{vhfffngghkjgghfjjghghghhjghjgghkghgghjhggjjkgfgdiyfgfjkjgjgSYSPNHH}
under the change of inertial or non-inertial cartesian coordinate
system:
\begin{theorem}\label{gjghghgghgintintrrZZHH}
Consider that the change of some cartesian coordinate system $(*)$
to another cartesian coordinate system $(**)$ is given by
\begin{equation}\label{noninchgravortbstrjgghguittu2intrrrZZHH}
\begin{cases}
t'=t,\\
\vec x'=A(t)\cdot\vec x+\vec z(t),\\
\vec x'_k=A(t)\cdot\vec x_k+\vec z(t)\quad\quad\forall k=1,\ldots,n,
\end{cases}
\end{equation}
where $A(t)\in SO(3)$ is a rotation. Next, assume that in the
coordinate system $(**)$ we observe a validity of the
Shr\"{o}dinger-Pauli equation of the form:
\begin{multline}\label{MaxVacFull1ninshtrredPPNintrrZZHH}
i\hbar\left(\frac{\partial\psi'}{\partial t'}+\sum_{j=1}^{n}\vec
v'(\vec x'_j,t')\cdot\nabla_{\vec
x'_j}\psi'\right)+\sum_{j=1}^{n}\frac{i\hbar}{2}\left(div_{\vec
x'_j}\vec v'(\vec x'_j,t')\right)\psi'=
-\sum_{j=1}^{n}\frac{\hbar^2}{2m'_j}\Delta_{\vec
x'_j}\psi'\\-V'\left(\vec x'_1,\ldots,\vec
x'_n,t'\right)\psi'+\sum_{j=1}^{n}\frac{
i\hbar\sigma'_j}{2m'_jc}div_{\vec x'_j}\left\{\psi'\vec A'(\vec
x'_j,t')\right\}+\sum_{j=1}^{n}\frac{ i\hbar\sigma'_j}{2m'_jc}\vec
A'(\vec x'_j,t')\cdot\nabla_{\vec
x'_j}\psi'\\+\sum_{j=1}^{n}\left(\sigma'_j\Psi'(\vec
x'_j,t')-\frac{\sigma'_j}{c}\vec A'(\vec x'_j,t')\cdot\vec v'(\vec
x'_j,t')+\frac{(\sigma')^2_j}{2m'_jc^2}\left|\vec A'(\vec
x'_j,t')\right|^2\right)\psi'\\-\sum_{j=1}^{n}\frac{g'_j\sigma'_j\hbar}{2m'_jc}\vec
S_j\cdot\left(curl_{\vec x'_j}\vec A'(\vec
x'_j,t')\,\psi'\right)+\sum_{j=1}^{n}\frac{\hbar}{4}\vec
S_j\cdot\left(curl_{\vec x'_j}\vec v'(\vec x'_j,t')\,\psi'\right)
\end{multline}
where $\psi(\vec x_1,\ldots,\vec x_n,t)\in\mathbb{C}^{2^n}$ is a
$2^n$-component complex wave function defined above. Then in the
coordinate system $(*)$ we have the validity of Shr\"{o}dinger-Pauli
equation of the same as \er{MaxVacFull1ninshtrredPPNintrrZZHH} form:
\begin{multline}\label{MaxVacFull1ninshtrhjkkredPPNintrrZZHH}
i\hbar\left(\frac{\partial\psi}{\partial t}+\sum_{j=1}^{n}\vec
v(\vec x_j,t)\cdot\nabla_{\vec
x_j}\psi\right)+\sum_{j=1}^{n}\frac{i\hbar}{2}\left(div_{\vec
x_j}\vec v(\vec x_j,t)\right)\psi=
-\sum_{j=1}^{n}\frac{\hbar^2}{2m_j}\Delta_{\vec
x_j}\psi\\-V\left(\vec x_1,\ldots,\vec
x_n,t\right)\psi+\sum_{j=1}^{n}\frac{
i\hbar\sigma_j}{2m_jc}div_{\vec x_j}\left\{\psi\vec A(\vec
x_j,t)\right\}+\sum_{j=1}^{n}\frac{ i\hbar\sigma_j}{2m_jc}\vec
A(\vec x_j,t)\cdot\nabla_{\vec
x_j}\psi\\+\sum_{j=1}^{n}\left(\sigma_j\Psi(\vec
x_j,t)-\frac{\sigma_j}{c}\vec A(\vec x_j,t)\cdot\vec v(\vec
x_j,t)+\frac{\sigma^2_j}{2m_jc^2}\left|\vec A(\vec
x_j,t)\right|^2\right)\psi\\-\sum_{j=1}^{n}\frac{g_j\sigma_j\hbar}{2m_jc}\vec
S_j\cdot\left(curl_{\vec x_j}\vec A(\vec
x_j,t)\,\psi\right)+\sum_{j=1}^{n}\frac{\hbar}{4}\vec
S_j\cdot\left(curl_{\vec x_j}\vec v(\vec x_j,t)\,\psi\right),
\end{multline}
provided that we have:
\begin{equation}\label{yuythfgfyftydtydtydtyddyyyhhddhhhredPPN111hgghjgintintrrZZHH}
\begin{cases}
g'_j\,=\,g_j\\
V'\left(\vec x'_1,\ldots,\vec x'_n,t'\right)\,=\,V\left(\vec
x_1,\ldots,\vec x_n,t\right),\\
\sigma'_j\,=\,\sigma_j,\\
m'_j\,=\,m_j,\\
\vec v'(\vec x',t)\,=\,A(t)\cdot \vec v(\vec x,t)+\frac{dA}{dt}(t)\cdot\vec x+\frac{d\vec z}{dt}(t),\\
\vec A'(\vec x',t)\,=\,A(t)\cdot \vec A(\vec x,t),\\
\Psi'(\vec x',t)-\vec v'(\vec x',t)\cdot\vec A'(\vec x',t)\,=\,\Psi(\vec x,t)-\vec v(\vec x,t)\cdot\vec A(\vec x,t),\\
\psi'\left(\vec x'_1,\ldots,\vec
x'_n,t'\right)\,=\,\left(U(t)\otimes_{1}U(t)\otimes_{2}U(t)\ldots\otimes_{(n-1)}U(t)
\right)\cdot\psi\left(\vec x_1,\ldots,\vec x_n,t\right),
\end{cases}
\end{equation}
where, as before, $U(t)\in SU(2)$
is characterized by:
\begin{equation}\label{gyfyfgfgfgghZZHH}
U^*(t)\cdot\vec S\cdot U(t)=A(t)\cdot\vec S,
\end{equation}
where
$\vec S:=(S_1,S_2,S_3)$,
that means
\begin{multline*}
\left(U^*(t)\cdot S_1\cdot U(t),U^*(t)\cdot S_2\cdot
U(t),U^*(t)\cdot S_3\cdot U(t)\right)=\\
\left(a_{11}(t)S_1+a_{12}(t)S_2+a_{13}(t)S_3\,,\,a_{21}(t)S_1+a_{22}(t)S_2+a_{23}(t)S_3\,,\,a_{31}(t)S_1+a_{32}(t)S_2+a_{33}(t)S_3\right),
\end{multline*}
where $A(t)=\left\{a_{mk}(t)\right\}_{\{1\leq m,k\leq 3\}}$.
\end{theorem}

Next, consider the Lagrangian of the motion of system of $n$
spin-half quantum micro-particles with inertial masses $m_1,\ldots,
m_n$ and the charges $\sigma_1,\ldots,\sigma_n$ in the outer
gravitational and electromagnetical fields with characteristics
$\vec v(\vec x,t)$, $\vec A(\vec x,t)$ and $\Psi(\vec x,t)$ and
additional conservative field with potential $V(\vec y_1,\ldots,\vec
y_n,t)$. Then consider a Lagrangian density $L_0$ defined by
\begin{multline}\label{vhfffngghkjgghDDmmkkksmfggffgKK}
L_0\left(\psi,\vec A,\Psi,\vec v,\vec x_1,\ldots,\vec x_n,t\right):=\\
\frac{i\hbar}{2}\left(\left(\frac{\partial\psi}{\partial
t}+\sum\limits_{k=1}^{n}\vec v(\vec x_k,t)\cdot\nabla_{\vec
x_k}\psi\right)\cdot\bar\psi-\psi\cdot\left(\frac{\partial\bar\psi}{\partial
t}+\sum\limits_{k=1}^{n}\vec v(\vec x_k,t)\cdot\nabla_{\vec
x_k}\bar\psi\right)\right)-\sum\limits_{k=1}^{n}\frac{\hbar^2}{2m_k}\nabla_{\vec
x_k}\psi\cdot\nabla_{\vec x_k}\bar\psi\\+V\left(\vec x_1,\ldots,\vec
x_n,t\right)\psi\cdot\bar\psi
-\sum\limits_{k=1}^{n}\frac{\hbar\sigma_k
i}{2m_kc}\left(\nabla_{\vec
x_k}\psi\cdot\bar\psi-\psi\cdot\nabla_{\vec
x_k}\bar\psi\right)\cdot\vec A(\vec
x_k,t)-\sum\limits_{k=1}^{n}\frac{\sigma^2_k}{2m_kc^2}\left|\vec
A(\vec x_k,t)\right|^2\psi\cdot\bar\psi\\
-\sum\limits_{k=1}^{n}\sigma_k\left(\Psi(\vec x_k,t)-\frac{1}{c}\vec
v(\vec x_k,t)\cdot\vec A(\vec x_k,t)\right)\psi\cdot\bar\psi
\\-\sum\limits_{k=1}^{n}\frac{\hbar}{2}\left(\left(\vec
S_k\cdot\left(\frac{1}{2}curl_{\vec x_k}\vec v(\vec
x_k,t)-\frac{g_k\sigma_k}{m_kc}curl_{\vec x_k}\vec A(\vec
x_k,t)\right)\right)\cdot\psi\right)\cdot\bar\psi.
\end{multline}
where $\psi:=\psi(\vec x_1,\ldots\vec x_n,t)\in \mathbb{C}^{2^n}$ is
a wave function of the system.
Then, as before, we can get that $L_0$ is invariant under the change
of inertial or non-inertial cartesian coordinate systems of the form
\begin{equation*}
\begin{cases}
t'=t\\
\vec x'=A(t)\cdot\vec x+\vec z(t)\\
\vec x'_k=A(t)\cdot\vec x_k+\vec z(t)\quad\quad\forall
k=1,\ldots,n,\end{cases}
\end{equation*}
provided, we take
\er{yuythfgfyftydtydtydtyddyyyhhddhhhredPPN111hgghjgintintrrZZHH}
into account. Next we investigate stationary points of the
functional
\begin{equation}\label{btfffygtgyggyDDmmkkksmKK}
J(\psi)=\int_0^T\int_{\left(\mathbb{R}^3\right)^n}L_0\left(\psi,\vec
A,\Psi,\vec v,\vec x_1,\ldots,\vec x_n,t\right)d\vec x_1\ldots,d\vec
x_n dt.
\end{equation}
Then,
\begin{multline}\label{vhfffngghkjgghDDmmkkkghgsmKK}
0=\frac{\delta L_0}{\delta (\bar\psi)}=
i\hbar\left(\frac{\partial\psi}{\partial
t}+\sum\limits_{k=1}^{n}\frac{1}{2}\vec v(\vec
x_k,t)\cdot\nabla_{\vec
x_k}\psi+\sum\limits_{k=1}^{n}\frac{1}{2}div_{\vec
x_k}\left\{\psi\vec v(\vec
x_k,t)\right\}\right)+\sum\limits_{k=1}^{n}\frac{\hbar^2}{2m_k}\Delta_{\vec
x_k}\psi\\+V\left(\vec x_1,\ldots,\vec x_n,t\right)\psi
-\frac{\hbar\sigma_k i}{2m_kc}\left(\sum\limits_{k=1}^{n}\vec A(\vec
x_k,t)\cdot\nabla_{\vec x_k}\psi+\sum\limits_{k=1}^{n}div_{\vec
x_k}\left\{\psi\vec A(\vec
x_k,t)\right\}\right)-\sum\limits_{k=1}^{n}\frac{\sigma^2_k}{2m_kc^2}\left|\vec
A(\vec x_k,t)\right|^2\psi
\\-\sum\limits_{k=1}^{n}\sigma_k\left(\Psi(\vec x_k,t)-\frac{1}{c}\vec
v(\vec x_k,t)\cdot\vec A(\vec x_k,t)\right)\psi
\\-\sum\limits_{k=1}^{n}\frac{\hbar}{2}\left(\vec
S_k\cdot\left(\frac{1}{2}curl_{\vec x_k}\vec v(\vec
x_k,t)-\frac{g_k\sigma_k}{m_kc}curl_{\vec x_k}\vec A(\vec
x_k,t)\right)\right)\cdot\psi,
\end{multline}
and
\begin{multline}\label{vhfffngghkjgghDDmmkkkghguhyuysmKK}
0=\frac{\delta L_0}{\delta (\psi)}= \bar
{(i)}\hbar\left(\frac{\partial\bar\psi}{\partial
t}+\sum\limits_{k=1}^{n}\frac{1}{2}\vec v(\vec
x_k,t)\cdot\nabla_{\vec
x_k}\bar\psi+\sum\limits_{k=1}^{n}\frac{1}{2}div_{\vec
x_k}\left\{\bar\psi\vec v(\vec
x_k,t)\right\}\right)+\sum\limits_{k=1}^{n}\frac{\hbar^2}{2m_k}\Delta_{\vec
x_k}\bar\psi \\+V\left(\vec x_1,\ldots,\vec x_n,t\right)\bar\psi
-\sum\limits_{k=1}^{n}\frac{\hbar\sigma_k \bar
{(i)}}{2m_kc}\left(\vec A(\vec x_k,t)\cdot\nabla_{\vec
x_k}\bar\psi+div_{\vec x_k}\left\{\bar\psi\vec A(\vec
x_k,t)\right\}\right)-\sum\limits_{k=1}^{n}\frac{\sigma^2_k}{2m_kc^2}\left|\vec
A(\vec x_k,t)\right|^2\bar\psi
\\-\sum\limits_{k=1}^{n}\sigma_k\left(\Psi(\vec x_k,t)-\frac{1}{c}\vec v(\vec x_k,t)\cdot\vec
A(\vec x_k,t)\right)\bar\psi
\\-\sum\limits_{k=1}^{n}\frac{\hbar}{2}\left(\vec
S^T_k\cdot\left(\frac{1}{2}curl_{\vec x_k}\vec v(\vec
x_k,t)-\frac{g_k\sigma_k}{m_kc}curl_{\vec x_k}\vec A(\vec
x_k,t)\right)\right)\cdot\bar\psi,
\end{multline}
Equation \er{vhfffngghkjgghDDmmkkkghguhyuysmKK} is just a complex
conjugate of equation \er{vhfffngghkjgghDDmmkkkghgsmKK}. Thus the
Euler-Lagrange for \er{btfffygtgyggyDDmmkkksmKK} coincides with the
Shr\"{o}dinger-Pauli equation
\er{vhfffngghkjgghfjjghghghhjghjgghkghgghjhggjjkgfgdiyfgfjkjgjgSYSPNHH}.

Finally, assume that the first and the second particles have the
same mass $m_1=m_2$ and the same charge $\sigma_1=\sigma_2$ and
moreover, assume that we have
$$V\left(\vec x_1,\vec x_2,\vec x_3,\ldots,\vec
x_n,t\right)=V\left(\vec x_2,\vec x_1,\vec x_3,\ldots,\vec
x_n,t\right)\quad\quad\forall\,\vec x_1,\vec x_2,\vec
x_3,\ldots,\vec x_n\in\mathbb{R}^3,\quad\forall t.$$ In this case it
can be easily deduced that if $\psi(\vec x_1,\vec x_2,\vec
x_3,\ldots,\vec x_n,t)\in\mathbb{C}^{2^n}$ is a solution of
\er{vhfffngghkjgghfjjghghghhjghjgghkghgghjhggjjkgfgdiyfgfjkjgjgSYSPNHH},
then $A_{1,2}\cdot\psi(\vec x_2,\vec x_1,\vec x_3,\ldots,\vec
x_n,t)$ is also a solution of
\er{vhfffngghkjgghfjjghghghhjghjgghkghgghjhggjjkgfgdiyfgfjkjgjgSYSPNHH},
where by $A_{1,2}:\mathbb{C}^{2^n}\to \mathbb{C}^{2^n}$ we denote
the linear operator (matrix) defined as:
\begin{equation}\label{fyfgjguyiuiHHgjgjgghhHH}
A_{1,2}\cdot\left(a_1\otimes a_2\otimes a_3\otimes\ldots\otimes
a_n\right)=\left(a_2\otimes a_1\otimes a_3\otimes\ldots\otimes
a_n\right)\quad\quad\forall\, a_1,\ldots a_n\in\mathbb{C}^{2}.
\end{equation}
Therefore, if $\psi(\vec x_1,\vec x_2,\vec x_3,\ldots,\vec
x_n,t)\in\mathbb{C}^{2^n}$ is a solution of
\er{vhfffngghkjgghfjjghghghhjghjgghkghgghjhggjjkgfgdiyfgfjkjgjgSYSPNHH}
then for every $t\geq 0$ we will have
\begin{equation}\label{fyfgjguyiuiHHgjgjgghbmgjgfhfhhHH}
A_{1,2}\cdot\psi(\vec x_2,\vec x_1,\vec x_3,\ldots,\vec
x_n,t)=-\psi(\vec x_1,\vec x_2,\vec x_3,\ldots,\vec
x_n,t)\quad\quad\forall\,\vec x_1,\vec x_2,\vec x_3,\ldots,\vec
x_n\in\mathbb{R}^3,
\end{equation}
provided that \er{fyfgjguyiuiHHgjgjgghbmgjgfhfhhHH} holds for the
initial instant of time $t=0$. So we have a consistency with the
Pauli Exclusion Principle for two or more identical fermions.

\subsection{Quantum Liouville's equation for a finite system of spin-half
particles}\label{hilghjhghfdgdg11} Consider the
\underline{statistical} description of the motion of a system of $n$
spin-half micro-particles with inertial masses $m_1,\ldots,m_n$ and
the charges $\sigma_1,\ldots,\sigma_n$ in the outer gravitational
and electromagnetical field with characteristics $\vec v(\vec x,t)$,
$\vec A(\vec x,t)$ and $\Psi(\vec x,t)$ and additional conservative
field with potential $V(\vec y_1,\ldots,\vec y_n,t)$, taking into
account the spin interaction. Then, as before, the Quantum
Liouville's equation for this system of particles has the following
form:
\begin{multline}\label{vhfffngghkjgghfjjghghghhjghjgghkghggkghghjghSYSPNjj11}
i\hbar\frac{\partial\xi}{\partial t}\left(\vec x_1,\ldots,\vec
x_n,\vec y_1,\ldots,\vec y_n,t\right)=\\ \left(\hat H_0\left(\vec
x_1,\ldots,\vec x_n,t\right)\otimes I\right)\cdot\xi\left(\vec
x_1,\ldots,\vec x_n,\vec y_1,\ldots,\vec
y_n,t\right)\\-\left(I\otimes\hat H^*_0\left(\vec y_1,\ldots,\vec
y_n,t\right)\right)\cdot\xi\left(\vec x_1,\ldots,\vec x_n,\vec
y_1,\ldots,\vec y_n,t\right),
\end{multline}
where $\xi\left(\vec x_1,\ldots,\vec x_n,\vec y_1,\ldots,\vec
y_n,t\right)\in\mathbb{C}^{2^n}\otimes \mathbb{C}^{2^n}$ is a
density-matrix function, $\hat H_0\left(\vec x_1,\ldots,\vec
x_n,t\right)$ is the Hamiltonian operator acting on the variables
$\left(\vec x_1,\ldots,\vec x_n\right)$, $\hat H^*_0\left(\vec
y_1,\ldots,\vec y_n,t\right)$ is the \underline{complex} conjugate
(not the Hermitian adjoint)  to the Hamiltonian operator acting on
the variables $\left(\vec y_1,\ldots,\vec y_n\right)$ and $\hat
H_0\left(\vec x_1,\ldots,\vec x_n,t\right)\otimes I$ and
$I\otimes\hat H^*_0\left(\vec y_1,\ldots,\vec y_n,t\right)$ are
linear operators acting on functions\\ $\xi\left(\vec
x_1,\ldots,\vec x_n,\vec y_1,\ldots,\vec
y_n,t\right)\in\mathbb{C}^{2^n}\otimes \mathbb{C}^{2^n}$, defined by
\begin{multline}\label{fyfgjguyiuiHHgjgjgghbmgjgfhfhhHHjhhhjhhkjkjk}
\left(\hat H_0\left(\vec x_1,\ldots,\vec x_n,t\right)\otimes
I\right)\cdot\left(\psi_1\left(\vec x_1,\ldots,\vec
x_n,t\right)\otimes \psi_2\left(\vec y_1,\ldots,\vec
y_n,t\right)\right)=\\ \left(\hat H_0\left(\vec x_1,\ldots,\vec
x_n,t\right)\cdot\psi_1\left(\vec x_1,\ldots,\vec
x_n,t\right)\right)\otimes
\psi_2\left(\vec y_1,\ldots,\vec y_n,t\right) \quad\text{and}\quad\\
\left(I\otimes\hat H^*_0\left(\vec y_1,\ldots,\vec
y_n,t\right)\right)\cdot\left(\psi_1\left(\vec x_1,\ldots,\vec
x_n,t\right)\otimes \psi_2\left(\vec y_1,\ldots,\vec
y_n,t\right)\right)=\\ \psi_1\left(\vec x_1,\ldots,\vec
x_n,t\right)\otimes \left(\hat H^*_0\left(\vec y_1,\ldots,\vec
y_n,t\right)\cdot\psi_2\left(\vec y_1,\ldots,\vec
y_n,t\right)\right)\quad\quad\forall\psi_1,\psi_2\in\mathbb{C}^{2^n}.
\end{multline}
Since by \er{vhfffngghkjgghfjjghghghSYSPN} the Hamiltonian operator
$\hat H_0$ has the forms:
\begin{multline}\label{vhfffngghkjgghfjjghghghhjghjgghkghgghjhggjjkgSYSPNjj11}
\hat H_0\left(\vec x_1,\ldots,\vec x_n,t\right)\cdot\psi\left(\vec
x_1,\ldots,\vec x_n,t\right)=
-\sum_{j=1}^{n}\frac{\hbar^2}{2m_j}\Delta_{\vec
x_j}\psi-\sum_{j=1}^{n}\frac{i\hbar}{2}div_{\vec x_j}\left\{\psi\vec
v(\vec x_j,t)\right\}\\-\sum_{j=1}^{n}\frac{i\hbar}{2}\vec v(\vec
x_j,t)\cdot\nabla_{\vec x_j}\psi+\sum_{j=1}^{n}\frac{
i\hbar\sigma_j}{2m_jc}div_{\vec x_j}\left\{\psi\vec A(\vec
x_j,t)\right\}+\sum_{j=1}^{n}\frac{ i\hbar\sigma_j}{2m_jc}\vec
A(\vec x_j,t)\cdot\nabla_{\vec
x_j}\psi\\+\sum_{j=1}^{n}\left(\sigma_j\Psi(\vec
x_j,t)-\frac{\sigma_j}{c}\vec A(\vec x_j,t)\cdot\vec v(\vec
x_j,t)+\frac{\sigma^2_j}{2m_jc^2}\left|\vec A(\vec
x_j,t)\right|^2\right)\psi-V\left(\vec x_1,\ldots,\vec
x_n,t\right)\psi
\\-\sum_{j=1}^{n}\frac{g_j\sigma_j\hbar}{2m_jc}\vec
S_j\cdot\left(curl_{\vec x_j}\vec A(\vec
x_j,t)\,\psi\right)+\sum_{j=1}^{n}\frac{\hbar}{4}\vec
S_j\cdot\left(curl_{\vec x_j}\vec v(\vec x_j,t)\,\psi\right),
\end{multline}
and consistently with
\er{vhfffngghkjgghfjjghghghhjghjgghkghgghjhggjjkgSYSPNjj11} we write
\begin{multline}\label{vhfffngghkjgghfjjghghghhjghjgghkghgghjhggjjkgSYSPNjjklkl11}
\hat H^*_0\left(\vec y_1,\ldots,\vec y_n,t\right)\cdot\psi\left(\vec
y_1,\ldots,\vec y_n,t\right)=
-\sum_{j=1}^{n}\frac{\hbar^2}{2m_j}\Delta_{\vec
y_j}\psi+\sum_{j=1}^{n}\frac{i\hbar}{2}div_{\vec y_j}\left\{\psi\vec
v(\vec y_j,t)\right\}\\+\sum_{j=1}^{n}\frac{i\hbar}{2}\vec v(\vec
y_j,t)\cdot\nabla_{\vec y_j}\psi-\sum_{j=1}^{n}\frac{
i\hbar\sigma_j}{2m_jc}div_{\vec y_j}\left\{\psi\vec A(\vec
y_j,t)\right\}-\sum_{j=1}^{n}\frac{ i\hbar\sigma_j}{2m_jc}\vec
A(\vec y_j,t)\cdot\nabla_{\vec
y_j}\psi\\+\sum_{j=1}^{n}\left(\sigma_j\Psi(\vec
y_j,t)-\frac{\sigma_j}{c}\vec A(\vec y_j,t)\cdot\vec v(\vec
y_j,t)+\frac{\sigma^2_j}{2m_jc^2}\left|\vec A(\vec
y_j,t)\right|^2\right)\psi-V\left(\vec y_1,\ldots,\vec
y_n,t\right)\psi
\\-\sum_{j=1}^{n}\frac{g_j\sigma_j\hbar}{2m_jc}\vec
{\bar S}_j\cdot\left(curl_{\vec y_j}\vec A(\vec
y_j,t)\,\psi\right)+\sum_{j=1}^{n}\frac{\hbar}{4}\vec {\bar
S}_j\cdot\left(curl_{\vec y_j}\vec v(\vec y_j,t)\,\psi\right),
\end{multline}
where $\vec {\bar S}_j$ is the the \underline{complex} conjugate
(not the Hermitian adjoint) of the operator $\vec S_j$. Thus we
rewrite the corresponding Quantum Liouville's equation
\er{vhfffngghkjgghfjjghghghhjghjgghkghggkghghjghSYSPNjj11} as:
%
%
%
%
%
%
\begin{multline}\label{vhfffngghkjgghfjjghghghhjghjgghkghgghjhggjjkgfgdiyfgfjkjgjgSYSPNjj11}
i\hbar\left(\frac{\partial\xi}{\partial t}+\sum_{j=1}^{n}\vec v(\vec
x_j,t)\cdot\nabla_{\vec x_j}\xi+\sum_{j=1}^{n}\vec v(\vec
y_j,t)\cdot\nabla_{\vec
y_j}\xi\right)+\sum_{j=1}^{n}\frac{i\hbar}{2}\left(div_{\vec
x_j}\vec v(\vec x_j,t)+div_{\vec y_j}\vec v(\vec y_j,t)\right)\xi=\\
\sum_{j=1}^{n}\frac{\hbar}{4}(\vec S_j\otimes
I)\cdot\left(curl_{\vec x_j}\vec v(\vec
x_j,t)\,\xi\right)-\sum_{j=1}^{n}\frac{\hbar}{4} (I\otimes \vec{\bar
S}_j)\cdot\left(curl_{\vec y_j}\vec v(\vec y_j,t)\,\xi\right)\\
-\sum_{j=1}^{n}\frac{\hbar^2}{2m_j}\left(\Delta_{\vec
x_j}\xi-\Delta_{\vec y_j}\xi\right)-\left(V\left(\vec
x_1,\ldots,\vec x_n,t\right)-V\left(\vec y_1,\ldots,\vec
y_n,t\right)\right)\xi\\+\sum_{j=1}^{n}\frac{
i\hbar\sigma_j}{2m_jc}\left(div_{\vec x_j}\left\{\xi\vec A(\vec
x_j,t)\right\}+div_{\vec y_j}\left\{\xi\vec A(\vec
y_j,t)\right\}\right)+\sum_{j=1}^{n}\frac{
i\hbar\sigma_j}{2m_jc}\left(\vec A(\vec x_j,t)\cdot\nabla_{\vec
x_j}\xi+\vec A(\vec y_j,t)\cdot\nabla_{\vec
y_j}\xi\right)\\+\sum_{j=1}^{n}\left(\sigma_j\Psi(\vec
x_j,t)-\frac{\sigma_j}{c}\vec A(\vec x_j,t)\cdot\vec v(\vec
x_j,t)+\frac{\sigma^2_j}{2m_jc^2}\left|\vec A(\vec
x_j,t)\right|^2\right)\xi\\- \sum_{j=1}^{n}\left(\sigma_j\Psi(\vec
y_j,t)-\frac{\sigma_j}{c}\vec A(\vec y_j,t)\cdot\vec v(\vec
y_j,t)+\frac{\sigma^2_j}{2m_jc^2}\left|\vec A(\vec
y_j,t)\right|^2\right)\xi\\-\sum_{j=1}^{n}\frac{g_j\sigma_j\hbar}{2m_jc}(\vec
S_j\otimes I)\cdot\left(curl_{\vec x_j}\vec A(\vec
x_j,t)\,\xi\right)
+\sum_{j=1}^{n}\frac{g_j\sigma_j\hbar}{2m_jc}(I\otimes \vec{\bar
S}_j)\cdot\left(curl_{\vec y_j}\vec A(\vec y_j,t)\,\xi\right),
\end{multline}
where, as before,
\begin{equation}\label{fyfgjguyiuiHHHHHHkjjjjljhjhhh}
(\vec S_j\otimes I)\cdot(a\otimes b)=(\vec S_j\cdot a)\otimes
b\quad\text{and}\quad(I\otimes\vec {\bar S}_j)\cdot(a\otimes
b)=a\otimes (\vec {\bar S}_j\cdot b)\quad\forall
a\in\mathbb{C}^{2^n},\;\forall b\in\mathbb{C}^{2^n}.
\end{equation}
Next consider a change of some non-inertial cartesian coordinate
system $(*)$ to another cartesian coordinate system $(**)$ of the
form \er{noninchgravortbstrjgghguittu2}:
\begin{equation}\label{noninchgravortbstrghgggSYSPNjj11}
\begin{cases}
\vec x'=A(t)\cdot\vec x+\vec z(t),\\
t'=t,
\end{cases}
\end{equation}
where $A(t)\in SO(3)$ is a rotation. Then, as before in Theorem
\ref{gjghghgghgintintrrZZHH}, we deduce that the Quantum Liouville's
equation equation of the form
\er{vhfffngghkjgghfjjghghghhjghjgghkghgghjhggjjkgfgdiyfgfjkjgjgSYSPNjj11}
is invariant under the change of non-inertial cartesian coordinate
system, provided we have
\begin{equation}\label{vyfgjhgjhvhgghSYSPNjj11}
\begin{cases}
\vec x'_j=A(t)\cdot\vec x_j+\vec z(t)\quad\forall j=1,\ldots,n\\
\vec y'_j=A(t)\cdot\vec y_j+\vec z(t)\quad\forall j=1,\ldots,n\\
\xi'\left(\vec x'_1,\ldots,\vec x'_n,\vec y'_1,\ldots,\vec
y'_n,t'\right)=\\
\left(\left(U(t)\otimes_{1}U(t)\ldots\otimes_{(n-1)}U(t)
\right)\otimes\left(\bar U(t)\otimes_{1}\bar
U(t)\ldots\otimes_{(n-1)}\bar U(t) \right)\right)\cdot\xi\left(\vec
x_1,\ldots,\vec x_n,\vec y_1,\ldots,\vec y_n,t\right)
\\
V'\left(\vec y'_1,\ldots,\vec y'_n,t'\right)=V\left(\vec
y_1,\ldots,\vec y_n,t\right)
\\
\vec A'=A(t)\cdot\vec A
\\
\Psi'-\frac{1}{c}\vec A'\cdot\vec v'=\Psi-\frac{1}{c}\vec A\cdot\vec
v,
\end{cases}
\end{equation}
where, as before, $U(t)\in SU(2)$
is characterized by:
\begin{equation}\label{gyfyfgfgfgghZZHHhugg}
U^*(t)\cdot\vec S\cdot U(t)=A(t)\cdot\vec S,
\end{equation} and $\bar U(t)$ is the
\underline{complex} conjugate (not the Hermitian adjoint) of the
matrix $U(t)$.

Next assume that $\hat A\left(\vec x_1,\ldots,\vec x_n,t\right)$ is
some Hermitian operator acting on the functions\\ $\psi\left(\vec
x_1,\ldots,\vec x_n,t\right)\in \mathbb{C}^{2^n}$ with respect to
spatial variables $\left(\vec x_1,\ldots,\vec x_n\right)$. Then the
average $\tilde A(t)$ of $\hat A$ on the density matrix $\xi$ is
defined by:
\begin{equation}\label{gyfyfgfgfgghZZHHhuggkjhjjgg}
\tilde A(t)=\frac{\int trace\left(\left(\hat A\left(\vec
x_1,\ldots,\vec x_n,t\right)\otimes
I\right)\cdot\xi\right)\left(\vec z_1,\ldots,\vec z_n,\vec
z_1,\ldots,\vec z_n,t\right)d\vec z_1\ldots d\vec z_n}{\int
trace\left(\xi\right)\left(\vec z_1,\ldots,\vec z_n,\vec
z_1,\ldots,\vec z_n,t\right)d\vec z_1\ldots d\vec z_n},
\end{equation}
where for given $R\in \mathbb{C}^{p}\otimes \mathbb{C}^{p}$
$trace(R)$ is a linear functional from $\mathbb{C}^{p}\otimes
\mathbb{C}^{p}$ to $\mathbb{C}$ defined by
\begin{equation}\label{fyfgjguyiuiHHHHHHkjjjjljhjhhhbbbgghg}
trace(a\otimes b)=\sum_{k=1}^{p}a_kb_k\quad\forall a=(a_1,\ldots
a_p)\in\mathbb{C}^{p},\;\forall b=(b_1,\ldots b_p)\in\mathbb{C}^{p}.
\end{equation}
On the other hand, using the fact that $\hat A$ is Hermitian, it can
be easily deduced that in addition to
\er{gyfyfgfgfgghZZHHhuggkjhjjgg} we have the following identity:
\begin{multline}\label{gyfyfgfgfgghZZHHhuggkjhjjggjhhh}
\int trace\left(\left(\hat A\left(\vec x_1,\ldots,\vec
x_n,t\right)\otimes I\right)\cdot\xi\right)\left(\vec
z_1,\ldots,\vec z_n,\vec z_1,\ldots,\vec z_n,t\right)d\vec z_1\ldots
d\vec z_n=\\ \int trace\left(\left(I\otimes\hat A^*\left(\vec
y_1,\ldots,\vec y_n,t\right)\right)\cdot\xi\right)\left(\vec
z_1,\ldots,\vec z_n,\vec z_1,\ldots,\vec z_n,t\right)d\vec z_1\ldots
d\vec z_n,
\end{multline}
where $\hat A^*\left(\vec y_1,\ldots,\vec y_n,t\right)$ is the
\underline{complex} conjugate (not the Hermitian adjoint)  to the
operator $\hat A$ acting on the variables $\left(\vec
y_1,\ldots,\vec y_n\right)$ and so,
\begin{equation}\label{gyfyfgfgfgghZZHHhuggkjhjjggjhhh99889}
\tilde A(t)=\frac{\int trace\left(\left(I\otimes\hat A^*\left(\vec
y_1,\ldots,\vec y_n,t\right)\right)\cdot\xi\right)\left(\vec
z_1,\ldots,\vec z_n,\vec z_1,\ldots,\vec z_n,t\right)d\vec z_1\ldots
d\vec z_n}{\int trace\left(\xi\right)\left(\vec z_1,\ldots,\vec
z_n,\vec z_1,\ldots,\vec z_n,t\right)d\vec z_1\ldots d\vec z_n}.
\end{equation}
In particular, by inserting \er{gyfyfgfgfgghZZHHhuggkjhjjggjhhh} in
the particular case $\hat A=\hat H_0$ into
\er{vhfffngghkjgghfjjghghghhjghjgghkghggkghghjghSYSPNjj11} we
deduce:
\begin{equation}\label{gyfyfgfgfgghZZHHhuggkjhjjggjhhhguggh}
i\hbar\frac{\partial}{\partial t}\int
trace\left(\xi\right)\left(\vec z_1,\ldots,\vec z_n,\vec
z_1,\ldots,\vec z_n,t\right)d\vec z_1\ldots d\vec z_n=0,
\end{equation}
and so,
\begin{equation}\label{gyfyfgfgfgghZZHHhuggkjhjjggjhhhjhgg}
\int trace\left(\xi\right)\left(\vec z_1,\ldots,\vec z_n,\vec
z_1,\ldots,\vec z_n,t\right)d\vec z_1\ldots d\vec z_n=\int
trace\left(\xi\right)\left(\vec z_1,\ldots,\vec z_n,\vec
z_1,\ldots,\vec z_n,0\right)d\vec z_1\ldots d\vec z_n.
\end{equation}

Next, by \er{vhfffngghkjgghfjjghghghhjghjgghkghggkghghjghSYSPNjj11}
we have
\begin{multline}\label{vhfffngghkjgghfjjghghghhjghjgghkghggkghghjghSYSPNjjjhhhjj}
i\hbar\frac{\partial\xi}{\partial t}\left(\vec x_1,\ldots,\vec
x_n,\vec y_1,\ldots,\vec y_n,t\right)=\\ \left(\hat H_0\left(\vec
x_1,\ldots,\vec x_n,t\right)\otimes I\right)\cdot\xi\left(\vec
x_1,\ldots,\vec x_n,\vec y_1,\ldots,\vec
y_n,t\right)\\-\left(I\otimes\hat H^*_0\left(\vec y_1,\ldots,\vec
y_n,t\right)\right)\cdot\xi\left(\vec x_1,\ldots,\vec x_n,\vec
y_1,\ldots,\vec y_n,t\right).
\end{multline}
Then, denote
\begin{equation}\label{gyfyfgfgfgghZZHHhuggkjhjjggjhhhjhgg22jjjkjkjj}
\xi_1\left(\vec x_1,\ldots,\vec x_n,\vec y_1,\ldots,\vec
y_n,t\right):=M\cdot\bar\xi\left(\vec y_1,\ldots,\vec y_n,\vec
x_1,\ldots,\vec x_n,t\right),
\end{equation}
where $M$ is a linear operator (matrix), acting on vectors in
$\mathbb{C}^{2^n}\otimes \mathbb{C}^{2^n}$ and satisfying:
\begin{equation}\label{fyfgjguyiuiHHHHHHkjjjjljhjhhhjhhjjh}
M\cdot(a\otimes b)=b\otimes a\quad\forall
a\in\mathbb{C}^{2^n},\;\forall b\in\mathbb{C}^{2^n},
\end{equation}
and consider
\begin{equation}\label{gyfyfgfgfgghZZHHhuggkjhjjggjhhhjhgg22jjjkjkkljjjkkjjj}
\zeta\left(\vec x_1,\ldots,\vec x_n,\vec y_1,\ldots,\vec
y_n,t\right):=\xi\left(\vec x_1,\ldots,\vec x_n,\vec y_1,\ldots,\vec
y_n,t\right)-\xi_1\left(\vec x_1,\ldots,\vec x_n,\vec
y_1,\ldots,\vec y_n,t\right).
\end{equation}
Thus,
\begin{multline}\label{vhfffngghkjgghfjjghghghhjghjgghkghggkghghjghSYSPNjjjhhhgggjj}
-i\hbar M\cdot \frac{\partial\bar\xi}{\partial t}\left(\vec
y_1,\ldots,\vec y_n,\vec x_1,\ldots,\vec x_n,t\right)=\\
\left(I\otimes\hat H^*_0\left(\vec y_1,\ldots,\vec
y_n,t\right)\right)\cdot\left(M\cdot\bar\xi\left(\vec
y_1,\ldots,\vec y_n,\vec x_1,\ldots,\vec
x_n,t\right)\right)\\-\left(\hat H_0\left(\vec x_1,\ldots,\vec
x_n,t\right)\otimes I\right)\cdot\left(M\cdot\bar\xi\left(\vec
y_1,\ldots,\vec y_n,\vec x_1,\ldots,\vec x_n,t\right)\right),
\end{multline}
and then
\begin{multline}\label{vhfffngghkjgghfjjghghghhjghjgghkghggkghghjghSYSPNjjjhhhhjhjjhjkhjhhjjtt}
i\hbar\frac{\partial\xi_1}{\partial t}\left(\vec x_1,\ldots,\vec
x_n,\vec y_1,\ldots,\vec y_n,t\right)=\\ \left(\hat H_0\left(\vec
x_1,\ldots,\vec x_n,t\right)\otimes I\right)\cdot\xi_1\left(\vec
x_1,\ldots,\vec x_n,\vec y_1,\ldots,\vec
y_n,t\right)\\-\left(I\otimes\hat H^*_0\left(\vec y_1,\ldots,\vec
y_n,t\right)\right)\cdot\xi_1\left(\vec x_1,\ldots,\vec x_n,\vec
y_1,\ldots,\vec y_n,t\right).
\end{multline}
Therefore,
\begin{multline}\label{vhfffngghkjgghfjjghghghhjghjgghkghggkghghjghSYSPNjjjhhhhjhjjhjkhjhhjj}
i\hbar\frac{\partial\zeta}{\partial t}\left(\vec x_1,\ldots,\vec
x_n,\vec y_1,\ldots,\vec y_n,t\right)=\\ \left(\hat H_0\left(\vec
x_1,\ldots,\vec x_n,t\right)\otimes I\right)\cdot\zeta\left(\vec
x_1,\ldots,\vec x_n,\vec y_1,\ldots,\vec
y_n,t\right)\\-\left(I\otimes\hat H^*_0\left(\vec y_1,\ldots,\vec
y_n,t\right)\right)\cdot\zeta\left(\vec x_1,\ldots,\vec x_n,\vec
y_1,\ldots,\vec y_n,t\right).
\end{multline}
Thus if $\zeta$ satisfies initial condition $\zeta\left(\vec
x_1,\ldots,\vec x_n,\vec y_1,\ldots,\vec y_n,0\right)=0$ then
$\zeta\left(\vec x_1,\ldots,\vec x_n,\vec y_1,\ldots,\vec
y_n,t\right)=0$ for every $t>0$. So
\begin{multline}\label{gyfyfgfgfgghZZHHhuggkjhjjggjhhhjhgg22jjjkjkkljjjkkjopiiiojj}
\xi\left(\vec x_1,\ldots,\vec x_n,\vec y_1,\ldots,\vec
y_n,0\right)=M\cdot\bar\xi\left(\vec y_1,\ldots,\vec y_n,\vec
x_1,\ldots,\vec x_n,0\right)\quad\text{implies}\quad\\
\xi\left(\vec x_1,\ldots,\vec x_n,\vec y_1,\ldots,\vec
y_n,t\right)=M\cdot\bar\xi\left(\vec y_1,\ldots,\vec y_n,\vec
x_1,\ldots,\vec x_n,t\right)\quad\forall\, t\geq 0.
\end{multline}

Finally, assume that the first and the second particles have the
same mass $m_1=m_2$ and the same charge $\sigma_1=\sigma_2$ and
moreover, assume that we have
$$V\left(\vec x_1,\vec x_2,\vec x_3,\ldots,\vec
x_n,t\right)=V\left(\vec x_2,\vec x_1,\vec x_3,\ldots,\vec
x_n,t\right)\quad\quad\forall\,\vec x_1,\vec x_2,\vec
x_3,\ldots,\vec x_n\in\mathbb{R}^3,\quad\forall t.$$ In this case it
can be easily deduced that if  $\xi\left(\vec x_1,\vec x_2,\vec
x_3,\ldots,\vec x_n,\vec y_1,\vec y_2,\vec y_3,\ldots,\vec
y_n,t\right)\in\mathbb{C}^{2^n}\otimes \mathbb{C}^{2^n}$ is a
solution of
\er{vhfffngghkjgghfjjghghghhjghjgghkghggkghghjghSYSPNjjjhhhjj}, then
$B_{1,2}\cdot\xi(\vec x_2,\vec x_1,\vec x_3,\ldots,\vec x_n,\vec
y_2,\vec y_1,\vec y_3,\ldots,\vec y_n,t)$
is also a solution of
\er{vhfffngghkjgghfjjghghghhjghjgghkghggkghghjghSYSPNjjjhhhjj},
where by $B_{1,2}:\mathbb{C}^{2^n}\otimes \mathbb{C}^{2^n}\to
\mathbb{C}^{2^n}\otimes \mathbb{C}^{2^n}$
we denote the linear operator (matrix) defined as:
\begin{multline}\label{fyfgjguyiuiHHgjgjgghhHHjjhjh}
B_{1,2}\cdot\left(\left(a_1\otimes a_2\otimes
a_3\otimes\ldots\otimes a_n\right)\otimes \left(b_1\otimes
b_2\otimes b_3\otimes\ldots\otimes b_n\right)\right)=\\
\left(\left(a_2\otimes a_1\otimes a_3\otimes\ldots\otimes
a_n\right)\otimes \left(b_2\otimes b_1\otimes
b_3\otimes\ldots\otimes b_n\right)\right)\quad\forall\, a_1,\ldots
a_n,b_1,\ldots,b_n\in\mathbb{C}^{2}.
\end{multline}
Therefore, if $\xi\left(\vec x_1,\vec x_2,\vec x_3,\ldots,\vec
x_n,\vec y_1,\vec y_2,\vec y_3,\ldots,\vec
y_n,t\right)\in\mathbb{C}^{2^n}\otimes \mathbb{C}^{2^n}$ is a
solution of
\er{vhfffngghkjgghfjjghghghhjghjgghkghggkghghjghSYSPNjjjhhhjj} then
for every $t\geq 0$ we will have
\begin{multline}\label{fyfgjguyiuiHHgjgjgghbmgjgfhfhhHHrr}
B_{1,2}\cdot\xi(\vec x_2,\vec x_1,\vec x_3,\ldots,\vec x_n,\vec
y_2,\vec y_1,\vec y_3,\ldots,\vec y_n,t)=\xi\left(\vec x_1,\vec
x_2,\vec x_3,\ldots,\vec x_n,\vec y_1,\vec y_2,\vec y_3,\ldots,\vec
y_n,t\right)
\\ \quad\quad\forall\,\vec x_1,\vec x_2,\vec
x_3,\ldots,\vec x_n,\vec y_1,\vec y_2,\vec y_3,\ldots,\vec
y_n\in\mathbb{R}^3,
\end{multline}
provided that \er{fyfgjguyiuiHHgjgjgghbmgjgfhfhhHHrr} holds for the
initial instant of time $t=0$. So we have a consistency with the
Pauli Exclusion Principle for two or more identical fermions.

Next given $\xi\left(\vec x_1,\ldots,\vec x_n,\vec y_1,\ldots,\vec
y_n,t\right)$ that satisfies
\begin{equation}\label{gyfyfgfgfgghZZHHhuggkjhjjggjhhhjhgg22jjjkjkkljjjkkjopiiiojjkljkjkjj}
\xi\left(\vec x_1,\ldots,\vec x_n,\vec y_1,\ldots,\vec
y_n,t\right)=M\cdot\bar\xi\left(\vec y_1,\ldots,\vec y_n,\vec
x_1,\ldots,\vec x_n,t\right)\quad\forall\, t\geq 0,
\end{equation}
define the operator $\hat R_\xi$ by
\begin{multline}\label{gyfyfgfgfgghZZHHhuggkjhjjggjhhhjhgg22jjjkjkkljjjkkjopiiiojjkljkjkjkljkjkjj}
\hat R_\xi\left(\vec x_1,\ldots,\vec x_n,t\right)\cdot\psi\left(\vec
x_1,\ldots,\vec x_n,t\right)=\\ \int A\left(\xi\left(\vec
x_1,\ldots,\vec x_n,\vec y_1,\ldots,\vec y_n,t\right),\psi\left(\vec
y_1,\ldots,\vec y_n,t\right)\right)d\vec y_1\ldots d\vec y_n,
\end{multline}
where $A$ is a bilinear mapping, acting on the pairs of vectors in
$\left(\mathbb{C}^{2^n}\otimes \mathbb{C}^{2^n}\right)\times
\mathbb{C}^{2^n}$, taking the values in $\mathbb{C}^{2^n}$ and
having the form
\begin{equation}\label{fyfgjguyiuiHHHHHHkjjjjljhjhhhjhhjjhyhyujkjk}
A\left((a\otimes b), c\right)=(b\cdot c)a\quad\forall
a\in\mathbb{C}^{2^n},\;\forall b\in\mathbb{C}^{2^n},\;\forall
c\in\mathbb{C}^{2^n}.
\end{equation}
Then the mapping $\xi\to\hat R_\xi$ is one-to-one. Moreover, by
\er{gyfyfgfgfgghZZHHhuggkjhjjggjhhhjhgg22jjjkjkkljjjkkjopiiiojjkljkjkjj}
$\hat R_\xi$ is an Hermitian operator. Finally by
\er{vhfffngghkjgghfjjghghghhjghjgghkghggkghghjghSYSPNjjjhhhjj} we
have
\begin{multline}\label{vhfffngghkjgghfjjghghghhjghjgghkghggkghghjghSYSPNjjjhhhjjkjkjklkjljljj}
i\hbar\frac{\partial\hat R_\xi}{\partial t}\left(\vec
x_1,\ldots,\vec x_n,t\right)=\\ \hat H_0\left(\vec x_1,\ldots,\vec
x_n,t\right)\cdot\hat R_\xi\left(\vec x_1,\ldots,\vec
x_n,t\right)-\hat R_\xi\left(\vec x_1,\ldots,\vec
x_n,t\right)\cdot\hat H_0\left(\vec x_1,\ldots,\vec x_n,t\right).
\end{multline}
Equation
\er{vhfffngghkjgghfjjghghghhjghjgghkghggkghghjghSYSPNjjjhhhjjkjkjklkjljljj}
is equivalent to
\er{vhfffngghkjgghfjjghghghhjghjgghkghggkghghjghSYSPNjjjhhhjj}.

Next, clearly,  $\xi$ satisfies
\er{fyfgjguyiuiHHgjgjgghbmgjgfhfhhHHrr} if and only if we have the
following relation:
\begin{multline}\label{fyfgjguyiuiHHgjgjgghbmgjgfhfhhHHljjjhjhjhyyuyu}
A_{1,2}\cdot\psi(\vec x_2,\vec x_1,\vec x_3,\ldots,\vec
x_n,t)=-\psi(\vec x_1,\vec x_2,\vec x_3,\ldots,\vec
x_n,t)\;\;\forall\,\vec x_1,\ldots,\vec
x_n\in\mathbb{R}^3\quad\text{and}\quad \phi=\hat R_\xi\cdot\psi\\
\quad\text{implies}\quad A_{1,2}\cdot\phi(\vec x_2,\vec x_1,\vec
x_3,\ldots,\vec x_n,t)=-\phi(\vec x_1,\vec x_2,\vec x_3,\ldots,\vec
x_n,t)\;\;\forall\,\vec x_1,\ldots,\vec x_n\in\mathbb{R}^3,
\end{multline}
where by $A_{1,2}:\mathbb{C}^{2^n}\to \mathbb{C}^{2^n}$ we denote
the linear operator (matrix) defined as in
\er{fyfgjguyiuiHHgjgjgghhHH} by the following:
\begin{equation}\label{fyfgjguyiuiHHgjgjgghhHHhuhuhuh}
A_{1,2}\cdot\left(a_1\otimes a_2\otimes a_3\otimes\ldots\otimes
a_n\right)=\left(a_2\otimes a_1\otimes a_3\otimes\ldots\otimes
a_n\right)\quad\quad\forall\, a_1,\ldots a_n\in\mathbb{C}^{2}.
\end{equation}

Finally, assume that $\hat A\left(\vec x_1,\ldots,\vec x_n,t\right)$
is some Hermitian operator acting on the functions\\ $\psi\left(\vec
x_1,\ldots,\vec x_n,t\right)\in \mathbb{C}^{2^n}$ with respect to
spatial variables $\left(\vec x_1,\ldots,\vec x_n\right)$. Then, we
remind that the average $\tilde A(t)$ of $\hat A$ on the density
matrix $\xi$ is defined by \er{gyfyfgfgfgghZZHHhuggkjhjjgg}:
\begin{equation}\label{gyfyfgfgfgghZZHHhuggkjhjjggkhjhjh}
\tilde A(t)=\frac{\int trace\left(\left(\hat A\left(\vec
x_1,\ldots,\vec x_n,t\right)\otimes
I\right)\cdot\xi\right)\left(\vec z_1,\ldots,\vec z_n,\vec
z_1,\ldots,\vec z_n,t\right)d\vec z_1\ldots d\vec z_n}{\int
trace\left(\xi\right)\left(\vec z_1,\ldots,\vec z_n,\vec
z_1,\ldots,\vec z_n,t\right)d\vec z_1\ldots d\vec z_n}.
\end{equation}
Thus,
\begin{equation}\label{gyfyfgfgfgghZZHHhuggkjhjjgg22hjhjjhjhmnbmbhjhj}
trace \left(\hat A\circ \hat R_\xi\right)=\int trace\left(\left(\hat
A\left(\vec x_1,\ldots,\vec x_n,t\right)\otimes
I\right)\cdot\xi\right)\left(\vec z_1,\ldots,\vec z_n,\vec
z_1,\ldots,\vec z_n,t\right)d\vec z_1\ldots d\vec z_n,
\end{equation}
and so,
\begin{equation}\label{gyfyfgfgfgghZZHHhuggkjhjjgg22hjhjjhjhmnbmb}
\tilde A(t)=\frac{trace \left(\hat A\circ \hat R_\xi\right)}{trace
\left(\hat R_\xi\right)}=\frac{trace \left( \hat R_\xi\circ \hat
A\right)}{trace \left(\hat R_\xi\right)},
\end{equation}
where by the trace we mean the trace of an operator on a Hilbert
space. Moreover, by \er{gyfyfgfgfgghZZHHhuggkjhjjggjhhhjhgg} and
\er{gyfyfgfgfgghZZHHhuggkjhjjgg22hjhjjhjhmnbmbhjhj} we have
\begin{equation}\label{gyfyfgfgfgghZZHHhuggkjhjjgg22hjhjjhjhmnbmbjkjhjhjhh}
trace \left( \hat R_\xi \right)(t)=trace \left( \hat
R_\xi\right)(0).
\end{equation}

\subsubsection{Thermodynamical equilibrium in systems of spin-half particles; canonical
ensemble} Let $\xi\left(\vec x_1,\ldots,\vec x_n,\vec
y_1,\ldots,\vec y_n,t\right)$ be the density matrix of the quantum
system, ruled by the Hamiltonian operator $\hat H_0$ from
\er{vhfffngghkjgghfjjghghghhjghjgghkghgghjhggjjkgSYSPNjj11}, in the
case of \underline{thermodynamical equilibrium} and $\hat R_\xi$ is
given by
\er{gyfyfgfgfgghZZHHhuggkjhjjggjhhhjhgg22jjjkjkkljjjkkjopiiiojjkljkjkjkljkjkjj}
for this $\xi$. Then in the case that $\frac{\partial \hat
H_0}{\partial t}\equiv 0$ and the given equilibrium system rests
macroscopically, i.e. it has the macroscopical velocity field zero:
$\vec u(\vec x,t)\equiv 0$ it is well known that
\begin{equation}\label{vhfffngghkjgghfjjghghghSYSPNjljllkljjhkhkhklpl;lpoopkklkljlrrccnn}
\hat R_\xi=\frac{1}{trace \left( f\circ \hat H_0 \right)} f\circ
\hat H_0,
\end{equation}
and in the case of a system in thermostat, having the Kelvin
temperature $T$, we have the canonical ensemble:
\begin{equation}\label{vhfffngghkjgghfjjghghghSYSPNjljllkljjhkhkhklpl;lpoopkklkljlrrcchhkhccnn}
f(s)=e^{-\frac{s}{kT}}=\sum_{m=0}^{+\infty}\frac{1}{m!}\left(-\frac{s}{kT}\right)^m\quad\forall
s\in\mathbb{C},
\end{equation}
where $k$ is the Boltzmann constant and by $f\circ \hat H_0$ we
denote the following operator
\begin{equation}\label{vhfffngghkjgghfjjghghghSYSPNjljllkljjhkhkhklpl;lpoopkklkljlrrcchhkhcchjhjgccnn}
f\circ \hat
H_0=\sum_{m=0}^{+\infty}\frac{1}{m!}\left(-\frac{1}{kT}\hat
H_0\right)^m\quad\forall s\in\mathbb{C}.
\end{equation}
We would like to find alternative form of the above law of
thermodynamical equilibrium, which is invariant under the change of
inertial or non-inertial cartesian coordinate system. Then it is
clear that if the given system rests in the old coordinate system,
then it obviously has non-trivial macroscopical velocity field $\vec
u(\vec x,t)$ in the new one. Moreover, $\vec u(\vec x,t)$ can depend
on $\vec x$ and $t$, as it indeed happens in the case of a rotation
of the new coordinate system with respect to the old one. On the
other hand the concept of thermodynamical equilibrium is clearly
independent of the coordinate system.

In order to find the forms of the law of thermodynamical
equilibrium, which are indeed invariant under the change of inertial
or non-inertial cartesian coordinate system we follow the steps as
below. Given a speed-like vector field $\vec u=\vec u(\vec x,t)$,
define
\begin{multline}\label{vhfffngghkjgghfjjghghghSYSPNjljll'ljjjjjjhkjhjhjjkjkkjj}
\hat H_{\vec u}\left(\vec x_1,\ldots,\vec x_n,t\right):=\\
\hat H_0\left(\vec x_1,\ldots,\vec
x_n,t\right)-\frac{1}{2}\left(\sum_{j=1}^{n}\hat {\vec P}_j\cdot\vec
u(\vec x_j,t)+\vec u(\vec x_j,t)\cdot \hat {\vec
P}_j\right)-\sum_{j=1}^{n}\frac{\hbar}{4}\vec
S_j\cdot\left(curl_{\vec x_j}\vec u(\vec x_j,t)\right),
\end{multline}
where $\hat{\vec P}_j$ is operator defined as
\begin{equation}\label{vhfffngghkjgghfjjghghghSYSPNjljll'ljjjjjjhkjhjhjjkjkkljjljjkkjj}
\hat {\vec P}_j\cdot\psi\left(\vec x_1,\ldots,\vec
x_n\right)=-i\hbar\nabla_{\vec x_j}\psi\left(\vec x_1,\ldots,\vec
x_n\right).
\end{equation}
Then by \er{vhfffngghkjgghfjjghghghhjghjgghkghgghjhggjjkgSYSPNjj11}
we have
\begin{multline}\label{vhfffngghkjgghfjjghghghhjghjgghkghgghjhggjjkgSYSPNjjjljjkljjkjkjj}
\hat H_{\vec u}\left(\vec x_1,\ldots,\vec
x_n,t\right)\cdot\psi\left(\vec x_1,\ldots,\vec x_n,t\right)=
-\sum_{j=1}^{n}\frac{\hbar^2}{2m_j}\Delta_{\vec
x_j}\psi-\sum_{j=1}^{n}\frac{i\hbar}{2}div_{\vec
x_j}\left\{\psi\left(\vec v(\vec x_j,t)-\vec u(\vec
x_j,t)\right)\right\}\\-\sum_{j=1}^{n}\frac{i\hbar}{2}\left(\vec
v(\vec x_j,t)-\vec u(\vec x_j,t)\right)\cdot\nabla_{\vec
x_j}\psi+\sum_{j=1}^{n}\frac{ i\hbar\sigma_j}{2m_jc}div_{\vec
x_j}\left\{\psi\vec A(\vec x_j,t)\right\}+\sum_{j=1}^{n}\frac{
i\hbar\sigma_j}{2m_jc}\vec A(\vec x_j,t)\cdot\nabla_{\vec
x_j}\psi\\+\sum_{j=1}^{n}\left(\sigma_j\Psi(\vec
x_j,t)-\frac{\sigma_j}{c}\vec A(\vec x_j,t)\cdot\vec v(\vec
x_j,t)+\frac{\sigma^2_j}{2m_jc^2}\left|\vec A(\vec
x_j,t)\right|^2\right)\psi-V\left(\vec x_1,\ldots,\vec
x_n,t\right)\psi\\-\sum_{j=1}^{n}\frac{g_j\sigma_j\hbar}{2m_jc}\vec
S_j\cdot\left(curl_{\vec x_j}\vec A(\vec
x_j,t)\,\psi\right)+\sum_{j=1}^{n}\frac{\hbar}{4}\vec
S_j\cdot\left(curl_{\vec x_j}\left(\vec v(\vec x_j,t)-\vec u(\vec
x_j,t)\right)\,\psi\right).
\end{multline}
Thus, as before, we can prove that
\er{vhfffngghkjgghfjjghghghhjghjgghkghgghjhggjjkgSYSPNjjjljjkljjkjkjj}
is invariant under the change of inertial or non-inertial cartesian
coordinate systems i.e.
\begin{multline}\label{vhfffngghkjgghfjjghghghhjghjgghkghgghjhggjjkgSYSPNjjjljjkljjkjkjjz11}
\hat H'_{\vec u'}\left(\vec x'_1,\ldots,\vec
x'_n,t'\right)\cdot\psi'\left(\vec x'_1,\ldots,\vec
x'_n,t'\right)=\phi'\left(\vec x'_1,\ldots,\vec x'_n,t'\right)
\quad\text{if and only if}\quad\\  \hat H_{\vec u}\left(\vec
x_1,\ldots,\vec x_n,t\right)\cdot\psi\left(\vec x_1,\ldots,\vec
x_n,t\right)=\phi\left(\vec x_1,\ldots,\vec x_n,t\right),
\end{multline}
provided that, as before in
\er{yuythfgfyftydtydtydtyddyyyhhddhhhredPPN111hgghjgintintrrZZHH},
we have
\begin{equation}\label{yuythfgfyftydtydtydtyddyyyhhddhhhredPPN111hgghjgintintrrZZHHhkkhjhj11}
\begin{cases}
V'\left(\vec x'_1,\ldots,\vec x'_n,t'\right)\,=\,V\left(\vec
x_1,\ldots,\vec x_n,t\right),\\
\sigma'_j\,=\,\sigma_j,\\
m'_j\,=\,m_j,\\
g'_j\,=\,g_j,\\
\vec v'(\vec x',t)\,=\,A(t)\cdot \vec v(\vec x,t)+\frac{dA}{dt}(t)\cdot\vec x+\frac{d\vec z}{dt}(t),\\
\vec u'(\vec x',t)\,=\,A(t)\cdot \vec u(\vec x,t)+\frac{dA}{dt}(t)\cdot\vec x+\frac{d\vec z}{dt}(t),\\
\vec A'(\vec x',t)\,=\,A(t)\cdot \vec A(\vec x,t),\\
\Psi'(\vec x',t)-\vec v'(\vec x',t)\cdot\vec A'(\vec x',t)\,=\,\Psi(\vec x,t)-\vec v(\vec x,t)\cdot\vec A(\vec x,t),\\
\psi'\left(\vec x'_1,\ldots,\vec
x'_n,t'\right)\,=\,\left(U(t)\otimes_{1}U(t)\otimes_{2}U(t)\ldots\otimes_{(n-1)}U(t)
\right)\cdot\psi\left(\vec x_1,\ldots,\vec x_n,t\right),\\
\phi'\left(\vec x'_1,\ldots,\vec
x'_n,t'\right)\,=\,\left(U(t)\otimes_{1}U(t)\otimes_{2}U(t)\ldots\otimes_{(n-1)}U(t)
\right)\cdot\phi\left(\vec x_1,\ldots,\vec x_n,t\right).
\end{cases}
\end{equation}
%
%
%
%
%
%
Next consider
\begin{equation}\label{vhfffngghkjgghfjjghghghSYSPNjljllkljjhkhkhklpl;lpooprrccnn}
\hat R_\xi=\frac{1}{trace \left( f\circ \hat H_{\vec u} \right)}
f\circ \hat H_{\vec u},
\end{equation}
where, as before, in the case of a system in thermostat, having the
Kelvin temperature $T$, we have the canonical ensemble:
\begin{equation}\label{vhfffngghkjgghfjjghghghSYSPNjljllkljjhkhkhklpl;lpoopkklkljlrrcchhkhcchhhjjkljccnn}
f(s)=e^{-\frac{s}{kT}}=\sum_{m=0}^{+\infty}\frac{1}{m!}\left(-\frac{s}{kT}\right)^m\quad\forall
s\in\mathbb{C}
\end{equation}
with $k$ being the Boltzmann constant and $f\circ \hat H_{\vec u}$
is the following operator
\begin{equation}\label{vhfffngghkjgghfjjghghghSYSPNjljllkljjhkhkhklpl;lpoopkklkljlrrcchhkhcchjhjgcckjhhjccnn}
f\circ \hat H_{\vec
u}=\sum_{m=0}^{+\infty}\frac{1}{m!}\left(-\frac{1}{kT}\hat H_{\vec
u}\right)^m\quad\forall s\in\mathbb{C}.
\end{equation}
We would like to note here that since the operator $ \hat H_{\vec
u}$ is bounded from below and $f(s)$ decays rapidly as
$s\to+\infty$, there exists a density matrix $\xi_1$, such that the
operator $\hat R_{\xi_1}$, given by
\er{gyfyfgfgfgghZZHHhuggkjhjjggjhhhjhgg22jjjkjkkljjjkkjopiiiojjkljkjkjkljkjkjj},
equals to the operator $f\circ \hat H_{\vec u}$ and thus
\er{vhfffngghkjgghfjjghghghSYSPNjljllkljjhkhkhklpl;lpooprrccnn}
indeed has sense. Next by
\er{vhfffngghkjgghfjjghghghhjghjgghkghgghjhggjjkgSYSPNjjjljjkljjkjkjjz11}
and
\er{vhfffngghkjgghfjjghghghSYSPNjljllkljjhkhkhklpl;lpoopkklkljlrrcchhkhcchjhjgcckjhhjccnn}
the operator $\hat R_\xi$ in the left hand side of
\er{vhfffngghkjgghfjjghghghSYSPNjljllkljjhkhkhklpl;lpooprrccnn} is
invariant under the change of inertial or non-inertial cartesian
coordinate systems i.e.
\begin{multline}\label{vhfffngghkjgghfjjghghghhjghjgghkghgghjhggjjkgSYSPNjjjljjkljjkjkjjz1155kjjkjknn}
\hat R_{\xi'}\left(\vec x'_1,\ldots,\vec
x'_n,t'\right)\cdot\psi'\left(\vec x'_1,\ldots,\vec
x'_n,t'\right)=\phi'\left(\vec x'_1,\ldots,\vec x'_n,t'\right)
\quad\text{if and only if}\quad\\  \hat R_\xi\left(\vec
x_1,\ldots,\vec x_n,t\right)\cdot\psi\left(\vec x_1,\ldots,\vec
x_n,t\right)=\phi\left(\vec x_1,\ldots,\vec x_n,t\right),
\end{multline}
provided that, as before, we have
\er{yuythfgfyftydtydtydtyddyyyhhddhhhredPPN111hgghjgintintrrZZHHhkkhjhj11}
and
\begin{multline}\label{vyfgjhgjhvhgghSYSPNjj11jhjggjjhjg}
\xi'\left(\vec x'_1,\ldots,\vec x'_n,\vec y'_1,\ldots,\vec
y'_n,t'\right)=\\
\left(\left(U(t)\otimes_{1}U(t)\ldots\otimes_{(n-1)}U(t)
\right)\otimes\left(\bar U(t)\otimes_{1}\bar
U(t)\ldots\otimes_{(n-1)}\bar U(t) \right)\right)\cdot\xi\left(\vec
x_1,\ldots,\vec x_n,\vec y_1,\ldots,\vec y_n,t\right).
\end{multline}
Moreover, in the case $\vec u\equiv 0$
\er{vhfffngghkjgghfjjghghghSYSPNjljllkljjhkhkhklpl;lpooprrccnn}
coincides with
\er{vhfffngghkjgghfjjghghghSYSPNjljllkljjhkhkhklpl;lpoopkklkljlrrccnn}.
However, we still need to derive the restrictions on the field $\vec
u$ and the Hamiltonian operator $\hat H_0$, providing that our
system can indeed be found in the state of thermodynamical
equilibrium. We remind that in the case $\vec u\equiv 0$ the
appropriate restriction is $\frac{\partial \hat H_0}{\partial
t}\equiv 0$. In order to get these restrictions in the general case,
we need to insert $\hat R_\xi$ in
\er{vhfffngghkjgghfjjghghghSYSPNjljllkljjhkhkhklpl;lpooprrccnn} into
the equation in
\er{vhfffngghkjgghfjjghghghhjghjgghkghggkghghjghSYSPNjjjhhhjjkjkjklkjljljj}
which is equivalent to the quantum Liouville equation.

Therefore, assume that the vector field $\vec u$ satisfies
\er{vhfffngghkjgghfjjghghghhjghjgghkghggSYSPNkljljjkklmjkjkjjk,mmm,kl;k;lklkjkjkkjkjkjkjljljjljklggghgkhhkhkhkkklkklklkljkhkhjjkhjh}
with \er{vhfffngghkjgghfjjghghghSYSPNjljll'lhkhhjgggjgjgghgjkjjk}
i.e.
\begin{equation}\label{vhfffngghkjgghfjjghghghhjghjgghkghggSYSPNkljljjkklmjkjkjjk,mmm,kl;k;lklkjkjkkjkjkjkjljljjljklggghgkhhkhkhkkklkklklkljkhkhjjkhjhmnnmmn11}
\begin{cases}
\frac{\partial U}{\partial t}\left(\vec x_1,\ldots,\vec
x_n,t\right)+\sum_{j=1}^{n}\vec u( \vec x_j,t)\cdot\nabla_{\vec
x_j}U\left(\vec x_1,\ldots,\vec x_n,t\right)=0,\\
\frac{\partial}{\partial t}\left(\vec u(\vec x,t)-\vec v(\vec
x,t)\right)-curl_{\vec x}\left(\vec u(\vec x,t)\times\left(\vec
u(\vec x,t)-\vec v(\vec x,t)\right)\right)+\left(\Div_{\vec
x}\left(\vec u(\vec x,t)-\vec v(\vec x,t)\right)\right)\vec
u(\vec x,t)=0,\\
\frac{\partial\vec A}{\partial t}(\vec x,t)-curl_{\vec x}\left(\vec
u(\vec x,t)\times\vec A(\vec x,t)\right)+\left(\Div_{\vec x}\vec
A(\vec x,t)\right)\vec u(\vec x,t)=0,
\\
d_{\vec x}\vec u(\vec x,t)+\left\{d_{\vec x}\vec u(\vec
x,t)\right\}^T=0,
\end{cases}
\end{equation}
where
\begin{multline}\label{vhfffngghkjgghfjjghghghSYSPNjljll'lhkhhjgggjgjgghgjkjjkjkjk11}
U\left(\vec x_1,\ldots,\vec x_n,t\right):=\\
\sum_{j=1}^{n}\left(\frac{\sigma^2_j}{2m_jc^2}\left|\vec A(\vec
x_j,t)\right|^2+\sigma_j\Psi(\vec x_j,t)-\frac{\sigma_j}{c}\vec
A(\vec x_j,t)\cdot\vec v(\vec x_j,t)\right)-V\left(\vec
x_1,\ldots,\vec x_n,t\right).
\end{multline}
Then, by Proposition \ref{eqv2} below we have
\begin{equation}\label{vhfffngghkjgghfjjghghghhjghjgghkghggkghghjghSYSPNjjjhhhjjkjkjklkjljljkkhhkjiijjijhbmbgghghcchkhhnn}
i\hbar\frac{\partial}{\partial t}\left(f\circ \hat H_{\vec
u}\right)= \hat H_0\cdot\left(f\circ \hat H_{\vec
u}\right)-\left(f\circ \hat H_{\vec u}\right)\cdot\hat H_0,
\end{equation}
Then by
\er{vhfffngghkjgghfjjghghghhjghjgghkghggkghghjghSYSPNjjjhhhjjkjkjklkjljljkkhhkjiijjijhbmbgghghcchkhhnn}
together with
\er{gyfyfgfgfgghZZHHhuggkjhjjgg22hjhjjhjhmnbmbjkjhjhjhh} we deduce
that
\begin{equation}\label{vhfffngghkjgghfjjghghghhjghjgghkghggkghghjghSYSPNjjjhhhjjkjkjklkjljljkkhhkjiijjijhbmbgghghcchkhhjkjkknn}
i\hbar\frac{\partial}{\partial t}\left(\hat R_\xi\right)= \hat
H_0\cdot\left(\hat R_\xi\right)-\left(\hat R_\xi\right)\cdot\hat
H_0,
\end{equation}
where $\hat R_\xi$ is the operator in the left hand side of
\er{vhfffngghkjgghfjjghghghSYSPNjljllkljjhkhkhklpl;lpooprrccnn}. So
we indeed get
\er{vhfffngghkjgghfjjghghghhjghjgghkghggkghghjghSYSPNjjjhhhjjkjkjklkjljljj}
in the case of
\er{vhfffngghkjgghfjjghghghhjghjgghkghggSYSPNkljljjkklmjkjkjjk,mmm,kl;k;lklkjkjkkjkjkjkjljljjljklggghgkhhkhkhkkklkklklkljkhkhjjkhjhmnnmmn11},
\er{vhfffngghkjgghfjjghghghSYSPNjljll'lhkhhjgggjgjgghgjkjjkjkjk11}.
Moreover,
\er{vhfffngghkjgghfjjghghghhjghjgghkghggSYSPNkljljjkklmjkjkjjk,mmm,kl;k;lklkjkjkkjkjkjkjljljjljklggghgkhhkhkhkkklkklklkljkhkhjjkhjhmnnmmn11},
\er{vhfffngghkjgghfjjghghghSYSPNjljll'lhkhhjgggjgjgghgjkjjkjkjk11}
are invariant under the change of inertial or non-inertial
coordinate system.

\begin{proposition}\label{eqv2}
Assume that the speed-like vector field  $\vec u$ satisfies
\er{vhfffngghkjgghfjjghghghhjghjgghkghggSYSPNkljljjkklmjkjkjjk,mmm,kl;k;lklkjkjkkjkjkjkjljljjljklggghgkhhkhkhkkklkklklkljkhkhjjkhjhmnnmmn11}
and
\er{vhfffngghkjgghfjjghghghSYSPNjljll'lhkhhjgggjgjgghgjkjjkjkjk11}.
Next assume that the holomorphic function $f(z):\mathbb{C}\to
\mathbb{C}$ defined as a sum of the power series
\begin{equation}\label{vhfffngghkjgghfjjghghghhjghjgghkghggkghghjghSYSPNjjjhhhjjkjkjklkjljljkkhhkjiijjihhjjhjhjhjhkkhhk11cc}
f(z):=\sum_{m=0}^{+\infty}a_mz^m,
\end{equation}
with $a_m\in\mathbb{C}$, is such that for the operator $f\circ \hat
H_{\vec u}$, given by:
\begin{equation}\label{vhfffngghkjgghfjjghghghhjghjgghkghggkghghjghSYSPNjjjhhhjjkjkjklkjljljkkhhkjiijjihhjjhjhjhjhkkhhknnhkhh11cc}
f\circ \hat H_{\vec u}:=\sum_{m=0}^{+\infty}a_m\hat H^m_{\vec u},
\end{equation}
where the operator $\hat H_{\vec u}$  is given by
\er{vhfffngghkjgghfjjghghghhjghjgghkghgghjhggjjkgSYSPNjjjljjkljjkjkjj},
there exists a density matrix $\xi\left(\vec x_1,\ldots,\vec
x_n,\vec y_1,\ldots,\vec y_n,t\right)$, such that
\begin{equation}\label{vhfffngghkjgghfjjghghghhjghjgghkghggkghghjghSYSPNjjjhhhjjkjkjklkjljljkkhhkjiijjijhbmbgghgh11ggghhgfccjggjgjhcc}
\left(f\circ\hat H_{\vec u}\right)\left(\vec x_1,\ldots,\vec
x_n,t\right)=\hat R_{\xi}\left(\vec x_1,\ldots,\vec x_n,t\right),
\end{equation}
where $\hat R_\xi$ is given by
\er{gyfyfgfgfgghZZHHhuggkjhjjggjhhhjhgg22jjjkjkkljjjkkjopiiiojjkljkjkjkljkjkjj}.
Then the operator $f\circ \hat H_{\vec u}$ satisfies:
\begin{equation}\label{vhfffngghkjgghfjjghghghhjghjgghkghggkghghjghSYSPNjjjhhhjjkjkjklkjljljkkhhkjiijjijhbmbgghghihihhcc}
i\hbar\frac{\partial}{\partial t}\left(f\circ \hat H_{\vec
u}\right)= \hat H_0\cdot\left(f\circ \hat H_{\vec
u}\right)-\left(f\circ \hat H_{\vec u}\right)\cdot\hat H_0,
\end{equation}
or equivalently
\begin{multline}\label{vhfffngghkjgghfjjghghghhjghjgghkghggkghghjghSYSPNjjjhhhjjcc}
i\hbar\frac{\partial\xi}{\partial t}\left(\vec x_1,\ldots,\vec
x_n,\vec y_1,\ldots,\vec y_n,t\right)=\\ \left(\hat H_0\left(\vec
x_1,\ldots,\vec x_n,t\right)\otimes I\right)\cdot\xi\left(\vec
x_1,\ldots,\vec x_n,\vec y_1,\ldots,\vec
y_n,t\right)\\-\left(I\otimes\hat H^*_0\left(\vec y_1,\ldots,\vec
y_n,t\right)\right)\cdot\xi\left(\vec x_1,\ldots,\vec x_n,\vec
y_1,\ldots,\vec y_n,t\right).
\end{multline}
Moreover, $f\circ \hat H_{\vec u}$ is invariant under the change of
inertial or non-inertial cartesian coordinate systems i.e.
\begin{multline}\label{vhfffngghkjgghfjjghghghhjghjgghkghgghjhggjjkgSYSPNjjjljjkljjkjkjjzjjjkkhhklhhlcciyiyyyycc}
\left(f\circ\hat H'_{\vec u'}\right)\left(\vec x'_1,\ldots,\vec
x'_n,t'\right)\cdot\psi'\left(\vec x'_1,\ldots,\vec
x'_n,t'\right)=\phi'\left(\vec x'_1,\ldots,\vec x'_n,t'\right)
\quad\text{if and only if}\quad\\  \left(f\circ\hat H_{\vec
u}\right)\left(\vec x_1,\ldots,\vec x_n,t\right)\cdot\psi\left(\vec
x_1,\ldots,\vec x_n,t\right)=\phi\left(\vec x_1,\ldots,\vec
x_n,t\right),
\end{multline}
provided that we have
\er{yuythfgfyftydtydtydtyddyyyhhddhhhredPPN111hgghjgintintrrZZHHhkkhjhj11}.
Finally, if we assume that the first and the second particles have
the same mass $m_1=m_2$ and the same charge $\sigma_1=\sigma_2$ and
moreover, if we assume that
$$V\left(\vec x_1,\vec x_2,\vec x_3,\ldots,\vec
x_n,t\right)=V\left(\vec x_2,\vec x_1,\vec x_3,\ldots,\vec
x_n,t\right)\quad\quad\forall\,\vec x_1,\vec x_2,\vec
x_3,\ldots,\vec x_n\in\mathbb{R}^3,\quad\forall t,$$ then
\begin{multline}\label{fyfgjguyiuiHHgjgjgghbmgjgfhfhhHHljjjhjhjhyyuyuhhhljjjkjjccbb}
A_{1,2}\cdot\psi(\vec x_2,\vec x_1,\vec x_3,\ldots,\vec
x_n,t)=-\psi(\vec x_1,\vec x_2,\vec x_3,\ldots,\vec
x_n,t)\;\;\forall\,\vec x_1,\ldots,\vec
x_n\in\mathbb{R}^3\quad\text{and}\quad \phi=\left(f\circ\hat H_{\vec u}\right)\cdot\psi\\
\quad\text{implies}\quad A_{1,2}\cdot\phi(\vec x_2,\vec x_1,\vec
x_3,\ldots,\vec x_n,t)=-\phi(\vec x_1,\vec x_2,\vec x_3,\ldots,\vec
x_n,t)\;\;\forall\,\vec x_1,\ldots,\vec x_n\in\mathbb{R}^3,
\end{multline}
where, as before, by $A_{1,2}:\mathbb{C}^{2^n}\to \mathbb{C}^{2^n}$
we denote the linear operator (matrix) defined as in
\er{fyfgjguyiuiHHgjgjgghhHH} by the following:
\begin{equation}\label{fyfgjguyiuiHHgjgjgghhHHhuhuhuhhkhcc}
A_{1,2}\cdot\left(a_1\otimes a_2\otimes a_3\otimes\ldots\otimes
a_n\right)=\left(a_2\otimes a_1\otimes a_3\otimes\ldots\otimes
a_n\right)\quad\quad\forall\, a_1,\ldots a_n\in\mathbb{C}^{2}.
\end{equation}
\end{proposition}
\begin{proof}

Again by
\er{vhfffngghkjgghfjjghghghhjghjgghkghggSYSPNkljljjkklmjkjkjjk,mmm,kl;k;lklkjkjkkjkjkjkjljljjljklggghgkhhkhkhkkklkklklkljkhkhjjkhjhmnnmmn11}
and Proposition \ref{jhjkhhjhjhgjgjjkjk} there exists another
cartesian coordinate system $(**)$ such that under the change of
coordinate system $(*)$ to another cartesian coordinate system
$(**)$, given by \er{noninchgravortbstrjgghguittu2hhhhh}, we have
\begin{equation}
\label{jljjjjkhhhjkkww11}A(t)\cdot \vec u(\vec x,t)+
\frac{d A}{dt}(t)\cdot\vec x+
\frac{d\vec z}{dt}(t)=\vec u'(\vec x',t')=0.
\end{equation}
Thus, since
\er{vhfffngghkjgghfjjghghghhjghjgghkghggSYSPNkljljjkklmjkjkjjk,mmm,kl;k;lklkjkjkkjkjkjkjljljjljklggghgkhhkhkhkkklkklklkljkhkhjjkhjhmnnmmn11},
\er{vhfffngghkjgghfjjghghghSYSPNjljll'lhkhhjgggjgjgghgjkjjkjkjk11}
are invariant under the change of inertial or non-inertial cartesian
coordinate systems, as before, in system $(**)$ we have
\begin{equation}\label{vhfffngghkjgghfjjghghghhjghjgghkghggSYSPNkljljjkklmjkjkjjk,mmm,kl;k;lklkjkjkkjkjkjkjljljjljklggghgkhhkhkhkkklkklklkljkhkhjjkhjhmjmm,mjhjkjkhjhj11}
\begin{cases}
\frac{\partial U'}{\partial t'}\left(\vec x'_1,\ldots,\vec
x'_n,t'\right)=0,\\
\frac{\partial\vec v'}{\partial t'}(\vec
x',t')=0,\\
\frac{\partial\vec A'}{\partial t'}(\vec x',t')=0,
\\
\vec u'(\vec x',t')=0,
\end{cases}
\end{equation}
where
\begin{multline}\label{vhfffngghkjgghfjjghghghSYSPNjljll'lhkhhjgggjgjgghgjkjjkkjjhj11}
U'\left(\vec x'_1,\ldots,\vec x'_n,t'\right):=\\
\sum_{j=1}^{n}\left(\frac{(\sigma'_j)^2}{2m'_jc^2}\left|\vec A'(\vec
x'_j,t')\right|^2+\sigma'_j\Psi'(\vec
x'_j,t')-\frac{\sigma'_j}{c}\vec A'(\vec x'_j,t')\cdot\vec v'(\vec
x'_j,t')\right)-V'\left(\vec x'_1,\ldots,\vec x'_n,t'\right).
\end{multline}
On the other hand,
\er{vhfffngghkjgghfjjghghghhjghjgghkghggSYSPNkljljjkklmjkjkjjk,mmm,kl;k;lklkjkjkkjkjkjkjljljjljklggghgkhhkhkhkkklkklklkljkhkhjjkhjhmjmm,mjhjkjkhjhj11}
is equivalent to
\begin{equation}\label{vhfffnhhjhjjkkjhkjkhkh11}
\frac{\partial \hat H'_0}{\partial t'}\left(\vec x'_1,\ldots,\vec
x'_n,t'\right)=0\quad\text{and}\quad\vec u'(\vec x',t')=0,
\end{equation}
in system $(**)$. Thus, by \er{vhfffnhhjhjjkkjhkjkhkh11}, in system
$(**)$ we deduce:
\begin{equation}\label{vhfffngghkjgghfjjghghghhjghjgghkghggkghghjghSYSPNjjjhhhjjkjkjklkjljljkkhhkjjjkljjbbb11}
\hat H'_0\cdot\hat H'_{\vec u'}-\hat H'_{\vec u'}\cdot\hat H'_0=\hat
H'_0\cdot\hat H'_0-\hat H'_0\cdot\hat
H'_0=0=i\hbar\frac{\partial\hat H'_0}{\partial
t'}=i\hbar\frac{\partial\hat H'_{\vec u'}}{\partial t'}.
\end{equation}
So we get:
\begin{equation}\label{vhfffngghkjgghfjjghghghhjghjgghkghggkghghjghSYSPNjjjhhhjjkjkjklkjljljkkhhkjjjkljjbbbjkljljl11}
i\hbar\frac{\partial\hat H'_{\vec u'}}{\partial t'}=\hat
H'_0\cdot\hat H'_{\vec u'}-\hat H'_{\vec u'}\cdot\hat H'_0.
\end{equation}
Therefore, given the holomorphic function $f(z):\mathbb{C}\to
\mathbb{C}$ defined as a sum of the power series
\begin{equation}\label{vhfffngghkjgghfjjghghghhjghjgghkghggkghghjghSYSPNjjjhhhjjkjkjklkjljljkkhhkjiijjihhjjhjhjhjhkkhhk11}
f(z):=\sum_{m=0}^{+\infty}a_mz^m,
\end{equation}
with $a_m\in\mathbb{C}$, if we define the operator $f\circ \hat
H_{\vec u}$ as:
\begin{equation}\label{vhfffngghkjgghfjjghghghhjghjgghkghggkghghjghSYSPNjjjhhhjjkjkjklkjljljkkhhkjiijjihhjjhjhjhjhkkhhknnhkhh11}
f\circ \hat H_{\vec u}:=\sum_{m=0}^{+\infty}a_m\hat H^m_{\vec u},
\end{equation}
by
\er{vhfffngghkjgghfjjghghghhjghjgghkghggkghghjghSYSPNjjjhhhjjkjkjklkjljljkkhhkjjjkljjbbbjkljljl11}
and Proposition \ref{hugyfyfffgh} we deduce
\begin{equation}\label{vhfffngghkjgghfjjghghghhjghjgghkghggkghghjghSYSPNjjjhhhjjkjkjklkjljljkkhhkjiijjijhbmbgghghdd11}
i\hbar\frac{\partial}{\partial t'}\left(f\circ \hat H'_{\vec
u'}\right)= \hat H'_0\cdot\left(f\circ \hat H'_{\vec
u'}\right)-\left(f\circ \hat H'_{\vec u'}\right)\cdot\hat H'_0.
\end{equation}
Moreover, by
\er{vhfffngghkjgghfjjghghghhjghjgghkghgghjhggjjkgSYSPNjjjljjkljjkjkjjz11}
and
\er{vhfffngghkjgghfjjghghghhjghjgghkghggkghghjghSYSPNjjjhhhjjkjkjklkjljljkkhhkjiijjihhjjhjhjhjhkkhhknnhkhh11},
we can easily prove that $f\circ \hat H_{\vec u}$ is invariant under
the change of inertial or non-inertial cartesian coordinate systems
i.e.
\begin{multline}\label{vhfffngghkjgghfjjghghghhjghjgghkghgghjhggjjkgSYSPNjjjljjkljjkjkjjzjjjkkhhklhhl11}
\left(f\circ\hat H'_{\vec u'}\right)\left(\vec x'_1,\ldots,\vec
x'_n,t'\right)\cdot\psi'\left(\vec x'_1,\ldots,\vec
x'_n,t'\right)=\phi'\left(\vec x'_1,\ldots,\vec x'_n,t'\right)
\quad\text{if and only if}\quad\\  \left(f\circ\hat H_{\vec
u}\right)\left(\vec x_1,\ldots,\vec x_n,t\right)\cdot\psi\left(\vec
x_1,\ldots,\vec x_n,t\right)=\phi\left(\vec x_1,\ldots,\vec
x_n,t\right),
\end{multline}
provided that, as before, we have
\er{yuythfgfyftydtydtydtyddyyyhhddhhhredPPN111hgghjgintintrrZZHHhkkhjhj11}.
Next assume that the holomorphic function $f$ in
\er{vhfffngghkjgghfjjghghghhjghjgghkghggkghghjghSYSPNjjjhhhjjkjkjklkjljljkkhhkjiijjihhjjhjhjhjhkkhhk11}
is such that there exists a density matrix $\xi\left(\vec
x_1,\ldots,\vec x_n,\vec y_1,\ldots,\vec y_n,t\right)$ satisfying
\begin{equation}\label{vhfffngghkjgghfjjghghghhjghjgghkghggkghghjghSYSPNjjjhhhjjkjkjklkjljljkkhhkjiijjijhbmbgghgh11ggghhgf11gg11}
\left(f\circ\hat H_{\vec u}\right)\left(\vec x_1,\ldots,\vec
x_n,t\right)=\hat R_{\xi}\left(\vec x_1,\ldots,\vec x_n,t\right),
\end{equation}
where $\hat R_\xi$ is given by
\er{gyfyfgfgfgghZZHHhuggkjhjjggjhhhjhgg22jjjkjkkljjjkkjopiiiojjkljkjkjkljkjkjj}.
Then, by
\er{vhfffngghkjgghfjjghghghhjghjgghkghgghjhggjjkgSYSPNjjjljjkljjkjkjjzjjjkkhhklhhl11}
for the density matrix $\xi'\left(\vec x'_1,\ldots,\vec x'_n,\vec
y'_1,\ldots,\vec y'_n,t'\right)$ we have
\begin{equation}\label{vhfffngghkjgghfjjghghghhjghjgghkghggkghghjghSYSPNjjjhhhjjkjkjklkjljljkkhhkjiijjijhbmbgghgh11ggghhgf11}
\left(f\circ\hat H'_{\vec u'}\right)\left(\vec x'_1,\ldots,\vec
x'_n,t'\right)=\hat R_{\xi'}\left(\vec x'_1,\ldots,\vec
x'_n,t'\right),
\end{equation}
provided that
\begin{multline}\label{vyfgjhgjhvhgghSYSPNjj11jhjggj}
\xi'\left(\vec x'_1,\ldots,\vec x'_n,\vec y'_1,\ldots,\vec
y'_n,t'\right)=\\
\left(\left(U(t)\otimes_{1}U(t)\ldots\otimes_{(n-1)}U(t)
\right)\otimes\left(\bar U(t)\otimes_{1}\bar
U(t)\ldots\otimes_{(n-1)}\bar U(t) \right)\right)\cdot\xi\left(\vec
x_1,\ldots,\vec x_n,\vec y_1,\ldots,\vec y_n,t\right).
\end{multline}
Therefore, since we obtained before, that equation
\er{vhfffngghkjgghfjjghghghhjghjgghkghggkghghjghSYSPNjjjhhhjjkjkjklkjljljkkhhkjiijjijhbmbgghghdd11}
is equivalent to the primed version of
\er{vhfffngghkjgghfjjghghghhjghjgghkghggkghghjghSYSPNjjjhhhjj} and
at the same time equation
\er{vhfffngghkjgghfjjghghghhjghjgghkghggkghghjghSYSPNjjjhhhjj} is
invariant under the change of non-inertial cartesian coordinate
system, provided we have \er{vyfgjhgjhvhgghSYSPNjj11jhjggj}, with
the help of
\er{vhfffngghkjgghfjjghghghhjghjgghkghggkghghjghSYSPNjjjhhhjjkjkjklkjljljkkhhkjiijjijhbmbgghgh11ggghhgf11gg11},
as before, we deduce
\begin{equation}\label{vhfffngghkjgghfjjghghghhjghjgghkghggkghghjghSYSPNjjjhhhjjkjkjklkjljljkkhhkjiijjijhbmbgghgh11}
i\hbar\frac{\partial}{\partial t}\left(f\circ \hat H_{\vec
u}\right)= \hat H_0\cdot\left(f\circ \hat H_{\vec
u}\right)-\left(f\circ \hat H_{\vec u}\right)\cdot\hat H_0
\end{equation}
in an arbitrary coordinate system. So we get
\er{vhfffngghkjgghfjjghghghhjghjgghkghggkghghjghSYSPNjjjhhhjjkjkjklkjljljkkhhkjiijjijhbmbgghghihihhcc}
or equivalently
\er{vhfffngghkjgghfjjghghghhjghjgghkghggkghghjghSYSPNjjjhhhjjcc}.

Finally, assume that the first and the second particles have the
same mass $m_1=m_2$ and the same charge $\sigma_1=\sigma_2$ and
moreover, assume that we have
$$V\left(\vec x_1,\vec x_2,\vec x_3,\ldots,\vec
x_n,t\right)=V\left(\vec x_2,\vec x_1,\vec x_3,\ldots,\vec
x_n,t\right)\quad\quad\forall\,\vec x_1,\vec x_2,\vec
x_3,\ldots,\vec x_n\in\mathbb{R}^3,\quad\forall t.$$ In this case
clearly by
\er{vhfffngghkjgghfjjghghghhjghjgghkghgghjhggjjkgSYSPNjjjljjkljjkjkjj}
we deduce the following relation:
\begin{multline}\label{fyfgjguyiuiHHgjgjgghbmgjgfhfhhHHljjjhjhjhyyuyuhhh}
A_{1,2}\cdot\psi(\vec x_2,\vec x_1,\vec x_3,\ldots,\vec
x_n,t)=-\psi(\vec x_1,\vec x_2,\vec x_3,\ldots,\vec
x_n,t)\;\;\forall\,\vec x_1,\ldots,\vec
x_n\in\mathbb{R}^3\quad\text{and}\quad \phi=\hat H_{\vec u}\cdot\psi\\
\quad\text{implies}\quad A_{1,2}\cdot\phi(\vec x_2,\vec x_1,\vec
x_3,\ldots,\vec x_n,t)=-\phi(\vec x_1,\vec x_2,\vec x_3,\ldots,\vec
x_n,t)\;\;\forall\,\vec x_1,\ldots,\vec x_n\in\mathbb{R}^3,
\end{multline}
where, as before, by $A_{1,2}:\mathbb{C}^{2^n}\to \mathbb{C}^{2^n}$
we denote the linear operator (matrix) defined as in
\er{fyfgjguyiuiHHgjgjgghhHH} by the following:
\begin{equation}\label{fyfgjguyiuiHHgjgjgghhHHhuhuhuhhkh}
A_{1,2}\cdot\left(a_1\otimes a_2\otimes a_3\otimes\ldots\otimes
a_n\right)=\left(a_2\otimes a_1\otimes a_3\otimes\ldots\otimes
a_n\right)\quad\quad\forall\, a_1,\ldots a_n\in\mathbb{C}^{2}.
\end{equation}
Therefore, by
\er{vhfffngghkjgghfjjghghghhjghjgghkghggkghghjghSYSPNjjjhhhjjkjkjklkjljljkkhhkjiijjihhjjhjhjhjhkkhhknnhkhh11}
and \er{fyfgjguyiuiHHgjgjgghbmgjgfhfhhHHljjjhjhjhyyuyuhhh} we obtain
\begin{multline}\label{fyfgjguyiuiHHgjgjgghbmgjgfhfhhHHljjjhjhjhyyuyuhhhljjjkjj}
A_{1,2}\cdot\psi(\vec x_2,\vec x_1,\vec x_3,\ldots,\vec
x_n,t)=-\psi(\vec x_1,\vec x_2,\vec x_3,\ldots,\vec
x_n,t)\;\;\forall\,\vec x_1,\ldots,\vec
x_n\in\mathbb{R}^3\quad\text{and}\quad \phi=\left(f\circ\hat H_{\vec u}\right)\cdot\psi\\
\quad\text{implies}\quad A_{1,2}\cdot\phi(\vec x_2,\vec x_1,\vec
x_3,\ldots,\vec x_n,t)=-\phi(\vec x_1,\vec x_2,\vec x_3,\ldots,\vec
x_n,t)\;\;\forall\,\vec x_1,\ldots,\vec x_n\in\mathbb{R}^3,
\end{multline}
consistently with
\er{fyfgjguyiuiHHgjgjgghbmgjgfhfhhHHljjjhjhjhyyuyu}.
\end{proof}

\section{Relation between the gravitational and inertial masses and
conservation laws}\label{GravElectro}
\subsection{Basic assumptions and their consequences}
We assumed before that the electromagnetic field is influenced by
the gravitational field. We also can assume that the gravitational
field is influenced by the electromagnetic field. We remind that we
assume the first approximation of the law of gravitation in the form
of
\er{MaxVacFull1ninshtrgravortghhghgjkgghklhjgkghghjjkjhjkkggjkhjkhjjhhfhjhklkhkhjjklzzzyyyNWBWHWPPN222jkgghgg}.
I.e.
\begin{equation}
\label{MaxVacFull1ninshtrgravortghhghgjkgghklhjgkghghjjkjhjkkggjkhjkhjjhhfhjhklkhkhjjklzzzyyyNWBWHWPPN222jkgghggyuhg}
\begin{cases}
curl_{\vec x}\left(curl_{\vec x}\vec v\right)= 0,\\
\frac{\partial}{\partial t}\left(div_{\vec x}\vec v\right)+div_{\vec
x}\left\{\left(div_{\vec x}\vec v\right)\vec
v\right\}+\frac{1}{4}\left|d_{\vec x}\vec v+\{d_{\vec x}\vec
v\}^T\right|^2-\left(div_{\vec x}\vec v\right)^2= -4\pi GM,
\end{cases}
\end{equation}
%
%
%
%
%
%
where $M$ is the density of gravitational masses. However, till now
we said nothing about the relation between the density of inertial
and gravitational masses. If $\mu$ is the density of inertial
masses, then consistently with the classical Newtonian theory of
gravitation we assume that in the absence of essential
electromagnetic fields we should have
\begin{equation}\label{gghjgghfghd}
M=\mu.
\end{equation}
In order to satisfy the laws of conservation of the linear and
angular momentums and energy, consider the following conserved
proper scalar field $Q$, that we call
"electromagnetical-gravitational" mass density, which is negligible
in the absence of electromagnetic fields and satisfies the identity
\begin{equation}
\label{MaxVacFull1ninshtrgravortghhghgjkgghklhjgkghghjjkjhjkkggjkhjkhjjhhfhjhklkhkhjjklzzzyyyNWNWBWHWPPN}
\frac{\partial Q}{\partial t}+div_{\vec x}\left\{Q\vec v\right\}=-
div_\vec x\left\{\frac{1}{4\pi c}\vec D\times \vec B\right\}
\end{equation}
in the general case. Then, instead of \er{gghjgghfghd}, for the
general case of gravitational-electromagnetic fields we consider the
following relation between the gravitational and inertial mass
densities
\begin{equation}\label{gghjgghfghdkjgjj}
M=\mu+Q.
\end{equation}
Then by
\er{MaxVacFull1ninshtrgravortghhghgjkgghklhjgkghghjjkjhjkkggjkhjkhjjhhfhjhklkhkhjjklzzzyyyNWBWHWPPN222jkgghggyuhg}
and \er{gghjgghfghdkjgjj} we have the following law of gravitation:
\begin{equation}
\label{MaxVacFull1ninshtrgravortghhghgjkgghklhjgkghghjjkjhjkkggjkhjkhjjhhfhjhklkhkhjjklzzzyyyNWBWHWPPN}
\begin{cases}
curl_{\vec x}\left(curl_{\vec x}\vec v\right)= 0,\\
\frac{\partial}{\partial t}\left(div_{\vec x}\vec v\right)+div_{\vec
x}\left\{\left(div_{\vec x}\vec v\right)\vec
v\right\}+\frac{1}{4}\left|d_{\vec x}\vec v+\{d_{\vec x}\vec
v\}^T\right|^2-\left(div_{\vec x}\vec v\right)^2=  -4\pi G(\mu+Q).
\end{cases}
\end{equation}
%
%
%
%
%
%
Then as before, we deduce that the laws
\er{MaxVacFull1ninshtrgravortghhghgjkgghklhjgkghghjjkjhjkkggjkhjkhjjhhfhjhklkhkhjjklzzzyyyNWNWBWHWPPN}
and
\er{MaxVacFull1ninshtrgravortghhghgjkgghklhjgkghghjjkjhjkkggjkhjkhjjhhfhjhklkhkhjjklzzzyyyNWBWHWPPN}
are invariant under the change of non-inertial cartesian coordinate
system, provided that $Q'=Q$. We can rewrite
\er{MaxVacFull1ninshtrgravortghhghgjkgghklhjgkghghjjkjhjkkggjkhjkhjjhhfhjhklkhkhjjklzzzyyyNWBWHWPPN}
as
\begin{equation}
\label{MaxVacFull1ninshtrgravortghhghgjkgghklhjgkghghjjkjhjkkggjkhjkhjjhhfhjhklkhkhjjklzzzyyyhjfgfkjgjNWBWHWPPN}
\begin{cases}
curl_{\vec x}\left(curl_{\vec x}\vec v\right)= 0,\\
div_{\vec x}\left\{\frac{\partial\vec v}{\partial t}+d_\vec x\vec
v\cdot\vec v+\frac{1}{2}\vec v\times curl_{\vec x}\vec v\right\}
= -4\pi G(\mu+Q).
\end{cases}
\end{equation}
In particular in the inertial coordinate system $(*)$ we have:
\begin{equation}
\label{MaxVacFull1ninshtrgravortghhghgjkgghklhjgkghghjjkjhjkkggjkhjkhjjhhfhjhklkhkhjjklzzzyyyhjggjhgghhjhNWBWHWPPN}
\begin{cases}
curl_{\vec x}\vec v= 0,\\
div_{\vec x}\left\{\frac{\partial\vec v}{\partial t}+d_\vec x\vec
v\cdot\vec v\right\}= -4\pi G(\mu+Q),
\end{cases}
\end{equation}
that we can rewrite as
\begin{equation}
\label{MaxVacFull1ninshtrgravortghhghgjkgghklhjgkghghjjkjhjkkggjkhjkhjjhhfhjhklkhkhjjklzzzyyyhjggjhgghhjhNWNWBWHWPPN}
\begin{cases}
curl_{\vec x}\vec v= 0,\\
\frac{\partial\vec v}{\partial t}+d_\vec x\vec v\cdot\vec v=
-\nabla_{\vec x}\Phi,
\end{cases}
\end{equation}
where $\Phi$ is the scalar gravitational potential: a proper scalar
field which satisfies in every coordinate system:
\begin{equation}
\label{MaxVacFull1ninshtrgravortghhghgjkgghklhjgkghghjjkjhjkkggjkhjkhjjhhfhjhklkhkhjjklzzzyyyhjggjhgghhjhNWNWNWBWHWPPN}
\Delta_{\vec x}\Phi=4\pi G(\mu+Q).
\end{equation}
Since in the system $(*)$ we have $curl_{\vec x}\vec v=0$ we can
write
\begin{equation}
\label{MaxVacFull1ninshtrgravortghhghgjkgghklhjgkghghjjkjhjkkggjkhjkhjjhhfhjhklkhkhjjklzzzyyyhjggjhgghhjhNWNWNWNWNWBWHWPPN}
\begin{cases}
\vec v=\nabla_{\vec x}Z,\\
\frac{\partial Z}{\partial t}+\frac{1}{2}\left|\nabla_{\vec
x}Z\right|^2=-\Phi.
\end{cases}
\end{equation}
\begin{remark}\label{ghghvghhggghKKK}
Lemma \ref{fgbfghfh} from Appendix gives some insight that the
"electromagnetical-gravitational" mass density $Q$ in
\er{MaxVacFull1ninshtrgravortghhghgjkgghklhjgkghghjjkjhjkkggjkhjkhjjhhfhjhklkhkhjjklzzzyyyNWNWBWHWPPN}
should have the values of the same order as the quantity
$\frac{1}{c^2}\left(|\vec D|^2+|\vec B|^2\right)$ and therefore, in
the usual circumstances is negligible with respect to the inertial
mass density $\mu$. Thus we can write $Q\approx 0$ in
\er{MaxVacFull1ninshtrgravortghhghgjkgghklhjgkghghjjkjhjkkggjkhjkhjjhhfhjhklkhkhjjklzzzyyyNWBWHWPPN},
i.e. the force of gravity in an inertial coordinate system
approximately equals to the classical Newtonian force of gravity.
\end{remark}

\subsection{Conservation of the momentum, angular momentum and
energy} Consider the Maxwell equation in the vacuum in some
cartesian coordinate system $(*)$:
\begin{equation}\label{MaxVacFull1bjkgjhjhgjaaahkjhhjzzzyyykkknnnNWBWHWPPN}
\begin{cases}
curl_{\vec x} \vec H= \frac{4\pi}{c}\vec j+\frac{1}{c}\frac{\partial
\vec D}{\partial t},\\
div_{\vec x} \vec D= 4\pi\rho,\\
curl_{\vec x} \vec E+\frac{1}{c}\frac{\partial \vec B}{\partial t}=0,\\
div_{\vec x} \vec B=0,\\
\vec E=\vec D-\frac{1}{c}\,\vec v\times \vec B,\\
\vec H=\vec B+\frac{1}{c}\,\vec v\times \vec D,
\end{cases}
\end{equation}
and consistently with
\er{noninchgravortbstrjgghguittu2gjgghhjhghjhjgghgghghghtytythvfghfgghjgg},
consider in the system $(*)$ the second Law of Newton for the moving
continuum with the inertial mass density $\mu$ and the field of
velocities $\vec u$:
\begin{equation}\label{MaxVacFull1ninshtrgravortghhghgjkgghklhjgkghghjjkjhjkkggjkhjkhjjhhfhjhkzzzyyykkknnnNWBWHWPPN}
\frac{\partial\vec u}{\partial t}+d_{\vec x}\vec u\cdot \vec
u=-(\vec u-\vec v)\times curl_{\vec x}\vec v +\partial_t\vec
v+d_{\vec x}\vec v\cdot \vec v+\frac{1}{\mu}\tilde{\vec F},
\end{equation}
where $\tilde{\vec F}$ is the total volume density of all
non-gravitational forces acting on the continuum with mass density
$\mu$. Thus, again by \er{apfrm9}, we rewrite
\er{MaxVacFull1ninshtrgravortghhghgjkgghklhjgkghghjjkjhjkkggjkhjkhjjhhfhjhkzzzyyykkknnnNWBWHWPPN}
as:
\begin{multline}\label{MaxVacFull1ninshtrgravortghhghgjkgghklhjgkghghjjkjhjkkggjkhjkhjjhhfhjhkjkhbbgjhzzzyyykkknnnNWBWHWPPN}
\mu\left(\frac{\partial\vec u}{\partial t}+d_{\vec x}\vec u\cdot
\vec u\right)=\frac{\partial(\mu\vec u)}{\partial t}+div_{\vec
x}\left\{\mu\vec u\otimes\vec u\right\}=\\-\mu\vec u\times
curl_{\vec x}\vec v+\mu\partial_{t}\vec
v+\mu\nabla_{\vec x}\left(\frac{1}{2}|\vec v|^2\right)+\rho\vec E+\frac{1}{c}\vec j\times\vec B+\vec F=\\
=-\mu(\vec u-\vec v)\times curl_{\vec x}\vec v +\mu\partial_t\vec v+
d_{\vec x}\vec v\cdot (\mu\vec v)+\rho\vec E+\frac{1}{c}\vec
j\times\vec B+\vec F.
\end{multline}
where $\rho\vec E+\frac{1}{c}\vec j\times\vec B$ is the volume
density of the Lorentz force and $\vec F$ is the total volume
density of all non-gravitational and non-electromagnetic forces
acting on the continuum with mass density $\mu$, which satisfies the
continuum equation:
\begin{equation}\label{MaxVacFull1ninshtrgravortghhghgjkgghklhjgkghghjjkjhjkkggjkhjkhjjhhfhjhkhjjkhjgzzzyyykkknnnNWBWHWPPN}
\frac{\partial\mu}{\partial t}+div_{\vec x}\left(\mu\vec u\right)=0.
\end{equation}
Then, again by \er{apfrm9}, we rewrite
\er{MaxVacFull1ninshtrgravortghhghgjkgghklhjgkghghjjkjhjkkggjkhjkhjjhhfhjhkjkhbbgjhzzzyyykkknnnNWBWHWPPN}
as
\begin{equation}\label{vhfffngghkjgghggtghjgfhjoyuiyuyhiyyukukyihyuSYSPNNWhgjgghyuyy8yuyughghhCCmmGGKK}
\mu\frac{\partial}{\partial t}\left\{\left(\vec u-\vec v\right)
\right\} + \mu \,d_{\vec x}\left\{\left(\vec u-\vec
v\right)\right\}\cdot\vec u +\mu \left\{d_{\vec x}\vec
v\right\}^T\cdot\left(\vec u-\vec v\right)=\rho\vec
E+\frac{1}{c}\vec j\times \vec B+\vec F.
\end{equation}
Thus by
\er{vhfffngghkjgghggtghjgfhjoyuiyuyhiyyukukyihyuSYSPNNWhgjgghyuyy8yuyughghhCCmmGGKK}
and
\er{MaxVacFull1ninshtrgravortghhghgjkgghklhjgkghghjjkjhjkkggjkhjkhjjhhfhjhkhjjkhjgzzzyyykkknnnNWBWHWPPN}
we obtain
\begin{equation}\label{vhfffngghkjgghggtghjgfhjoyuiyuyhiyyukukyihyuSYSPNNWhgjgghyuyy8yuyughghhghghhjjhCCmmGGKK}
\frac{\partial}{\partial t}\left\{ \mu \left(\vec u-\vec v\right)
\right\} + div_{\vec x}\left\{\mu \left(\vec u-\vec
v\right)\otimes\vec u\right\} +\mu \left\{d_{\vec x}\vec
v\right\}^T\cdot\left(\vec u-\vec v\right)=\rho\vec
E+\frac{1}{c}\vec j\times \vec B+\vec F.
\end{equation}
Moreover, multiplying
\er{vhfffngghkjgghggtghjgfhjoyuiyuyhiyyukukyihyuSYSPNNWhgjgghyuyy8yuyughghhCCmmGGKK}
by $(\vec u-\vec v)$ and using
\er{MaxVacFull1ninshtrgravortghhghgjkgghklhjgkghghjjkjhjkkggjkhjkhjjhhfhjhkhjjkhjgzzzyyykkknnnNWBWHWPPN}
we have:
\begin{multline}\label{vhfffngghkjgghggtghjgfhjoyuiyuyhiyyukukyihyuSYSPNNWhgjgghyuyy8yuyughghhyttyytCCmmGGKK}
\frac{\partial}{\partial t}\left\{\frac{\mu}{2}\left|\vec u-\vec
v\right|^2\right\}+div_{\vec x}\left\{\frac{\mu}{2}\left|\vec u-\vec
v\right|^2\vec u\right\}=
\\ -\frac{1}{2}\left(d_{\vec x}\vec v+\left\{d_{\vec x}\vec v\right\}^T\right)^T:\left\{\mu\left(\vec u-\vec v\right)\otimes \left(\vec
u-\vec v\right)\right\}+\vec j\cdot\vec E -\vec v\cdot\left(\rho\vec
E+\frac{1}{c}\vec j\times \vec B\right)+(\vec u-\vec v)\cdot\vec F.
\end{multline}
On the other hand, by Lemma \ref{guigiukhn} and Lemma \ref{fgbfghfh}
in the Appendix we have:
\begin{multline}\label{hvkgkjgkjbjkjjkgjglhhkhjyuyghjhhjhjzzzyyykkknnnNWBWHWPPNNewZZZghghghCCmmGGKK}
\frac{\partial}{\partial t}\left(\frac{1}{4\pi c}\,\vec D\times \vec
B\right)+div_\vec x\left\{\left(\frac{1}{4\pi c}\vec D\times \vec
B\right)\otimes \vec v\right\}=-(d_\vec x \vec
v)^T\cdot\left(\frac{1}{4\pi c}\vec D\times \vec
B\right)\\+\frac{1}{4\pi}div_\vec x\left\{\vec D\otimes \vec D+\vec
B\otimes \vec B-\frac{1}{2}\left(|\vec D|^2+|\vec
B|^2\right)I\right\}-\left(\rho \vec E+\frac{1}{c}\vec j\times \vec
B\right),
\end{multline}
and
\begin{multline}\label{hvkgkjgkjbjbbklnknhihiokhhfjffghvjmbjhjkhlkzzzyyykkknnnNWBWHWPPNNewZZZgghghCCmmGGKK}
\frac{\partial}{\partial t}\left(\frac{|\vec D|^2+|\vec
B|^2}{8\pi}\right)+div_\vec x\left\{\left(\frac{|\vec D|^2+|\vec
B|^2}{8\pi}\right)\vec v\right\}=\\
\frac{1}{4\pi}div_\vec x\left\{(\vec D\otimes \vec D+ \vec B\otimes
\vec B)\cdot \vec v-\frac{1}{2}\left(|\vec D|^2+|\vec
B|^2\right)\vec v-c \vec D\times \vec B\right\}
\\-\left\{\frac{1}{4\pi}\left(div_\vec x\left\{\vec D\otimes \vec D+\vec B\otimes
\vec B-\frac{1}{2}\left(|\vec D|^2+|\vec
B|^2\right)I\right\}\right)-\left(\rho \vec E+\frac{1}{c}\,\vec
j\times \vec B\right)\right\}\cdot \vec v-\vec j\cdot \vec E =\\
-\frac{c}{4\pi}div_\vec x\left\{ \vec D\times \vec
B\right\}+\left(\rho \vec E+\frac{1}{c}\,\vec j\times \vec
B\right)\cdot \vec v-\vec j\cdot \vec E
\\+\frac{1}{8\pi}\left(d_{\vec x}\vec v+\left\{d_{\vec x}\vec
v\right\}^T\right):\left\{\vec D\otimes \vec D+\vec B\otimes \vec
B-\frac{1}{2}\left(|\vec D|^2+|\vec B|^2\right)I\right\}.
\end{multline}
Thus by
\er{vhfffngghkjgghggtghjgfhjoyuiyuyhiyyukukyihyuSYSPNNWhgjgghyuyy8yuyughghhghghhjjhCCmmGGKK}
and
\er{hvkgkjgkjbjkjjkgjglhhkhjyuyghjhhjhjzzzyyykkknnnNWBWHWPPNNewZZZghghghCCmmGGKK}
we have
\begin{multline}\label{vhfffngghkjgghggtghjgfhjoyuiyuyhiyyukukyihyuSYSPNNWhgjgghyuyy8yuyughghhghghhjjhhjhjkhjCCmmGGKK}
\frac{\partial}{\partial t}\left\{ \mu \left(\vec u-\vec v\right)
+\frac{1}{4\pi c}\,\vec D\times \vec B\right\}+ div_{\vec
x}\left\{\left(\mu \left(\vec u-\vec v\right)+\frac{1}{4\pi c}\vec
D\times \vec B\right)\otimes \vec v\right\}
\\+\left\{d_{\vec x}\vec v\right\}^T\cdot\left\{\mu\left(\vec u-\vec v\right)
+\frac{1}{4\pi c}\vec D\times \vec B\right\}=\\
\frac{1}{4\pi}div_\vec x\left\{\vec D\otimes \vec D+\vec B\otimes
\vec B-\frac{1}{2}\left(|\vec D|^2+|\vec B|^2\right)I-4\pi\mu
\left(\vec u-\vec v\right)\otimes(\vec u-\vec v)\right\}+\vec F,
\end{multline}
and by
\er{vhfffngghkjgghggtghjgfhjoyuiyuyhiyyukukyihyuSYSPNNWhgjgghyuyy8yuyughghhyttyytCCmmGGKK}
and
\er{hvkgkjgkjbjbbklnknhihiokhhfjffghvjmbjhjkhlkzzzyyykkknnnNWBWHWPPNNewZZZgghghCCmmGGKK}
we have
\begin{multline}\label{vhfffngghkjgghggtghjgfhjoyuiyuyhiyyukukyihyuSYSPNNWhgjgghyuyy8yuyughghhyttyytgghhgCCmmGGKK}
\frac{\partial}{\partial t}\left\{\frac{\mu}{2}\left|\vec u-\vec
v\right|^2+\frac{|\vec D|^2+|\vec B|^2}{8\pi}\right\}+div_{\vec
x}\left\{\left(\frac{\mu}{2}\left|\vec u-\vec v\right|^2+\frac{|\vec
D|^2+|\vec B|^2}{8\pi}\right)\vec v\right\}\\+div_{\vec
x}\left\{\frac{\mu}{2}\left|\vec u-\vec v\right|^2(\vec u-\vec
v)+\frac{c}{4\pi}\vec D\times \vec B\right\}=
-\frac{1}{2}\left(d_{\vec x}\vec v+\left\{d_{\vec x}\vec
v\right\}^T\right):\left\{\mu\left(\vec u-\vec v\right)\otimes
\left(\vec u-\vec v\right)\right\}\\+\frac{1}{8\pi}\left(d_{\vec
x}\vec v+\left\{d_{\vec x}\vec v\right\}^T\right):\left\{\vec
D\otimes \vec D+\vec B\otimes \vec B-\frac{1}{2}\left(|\vec
D|^2+|\vec B|^2\right)I
\right\}
+(\vec u-\vec v)\cdot\vec F.
\end{multline}
In particular by
\er{vhfffngghkjgghggtghjgfhjoyuiyuyhiyyukukyihyuSYSPNNWhgjgghyuyy8yuyughghhyttyytgghhgCCmmGGKK}
and
\er{vhfffngghkjgghggtghjgfhjoyuiyuyhiyyukukyihyuSYSPNNWhgjgghyuyy8yuyughghhghghhjjhhjhjkhjCCmmGGKK}
we have:
\begin{multline}\label{vhfffngghkjgghggtghjgfhjoyuiyuyhiyyukukyihyuSYSPNNWhgjgghyuyy8yuyughghhyttyytgghhghghgghjhkjhCCmmGGKK}
\frac{\partial}{\partial t}\left\{ \frac{\mu}{2}\left|\vec u-\vec
v\right|^2+\frac{|\vec D|^2+|\vec B|^2}{8\pi}\right\}+div_{\vec
x}\left\{\left(\frac{\mu}{2}\left|\vec u-\vec
v\right|^2+\left(\frac{|\vec D|^2+|\vec
B|^2}{8\pi}\right)\right)\vec v\right\}\\+div_{\vec
x}\left\{\frac{\mu}{2}\left|\vec u-\vec v\right|^2(\vec u-\vec
v)+\frac{c}{4\pi}\vec D\times \vec B\right\}= -\vec
v\cdot\frac{\partial}{\partial t}\left\{ \mu
 \left(\vec
u-\vec v\right) +\frac{1}{4\pi c}\,\vec D\times \vec B\right\} \\-
div_{\vec x}\left\{\left(\left(\mu\left(\vec u-\vec
v\right)+\frac{1}{4\pi c}\vec D\times \vec B\right)\cdot\vec
v\right)\vec v\right\}-div_{\vec x}\left\{\mu \left(\left(\vec
u-\vec v\right)\cdot\vec v\right) \left(\vec u-\vec
v\right)\right\}\\+\frac{1}{4\pi}div_{\vec x}\left\{\left(\vec
D\otimes \vec D+\vec B\otimes \vec B-\frac{1}{2}\left(|\vec
D|^2+|\vec B|^2\right)I\right)\cdot\vec v
\right\}
+\vec u\cdot\vec F,
\end{multline}
and thus
\begin{multline}\label{vhfffngghkjgghggtghjgfhjoyuiyuyhiyyukukyihyuSYSPNNWhgjgghyuyy8yuyughghhyttyytgghhghghgghjhkjhCCmmGGKKuiiuui}
\frac{\partial}{\partial t}\left\{ \frac{\mu}{2}\left|\vec u-\vec
v\right|^2+\frac{|\vec D|^2+|\vec B|^2}{8\pi}+\vec v\cdot\left( \mu
 \left(\vec
u-\vec v\right) +\frac{1}{4\pi c}\,\vec D\times \vec
B\right)\right\}\\+div_{\vec x}\left\{\left(\frac{\mu}{2}\left|\vec
u-\vec v\right|^2+\left(\frac{|\vec D|^2+|\vec
B|^2}{8\pi}\right)\right)\vec v\right\}+div_{\vec
x}\left\{\frac{\mu}{2}\left|\vec u-\vec v\right|^2(\vec u-\vec
v)+\frac{c}{4\pi}\vec D\times \vec B\right\}\\= \frac{\partial\vec
v}{\partial t}\cdot\left( \mu
 \left(\vec
u-\vec v\right) +\frac{1}{4\pi c}\,\vec D\times \vec B\right)-
div_{\vec x}\left\{\left(\left(\mu\left(\vec u-\vec
v\right)+\frac{1}{4\pi c}\vec D\times \vec B\right)\cdot\vec
v\right)\vec v\right\}\\-div_{\vec x}\left\{\mu \left(\left(\vec
u-\vec v\right)\cdot\vec v\right) \left(\vec u-\vec
v\right)\right\}+\frac{1}{4\pi}div_{\vec x}\left\{\left(\vec
D\otimes \vec D+\vec B\otimes \vec B-\frac{1}{2}\left(|\vec
D|^2+|\vec B|^2\right)I\right)\cdot\vec v
\right\}
+\vec u\cdot\vec F.
\end{multline}
Moreover, by
\er{vhfffngghkjgghggtghjgfhjoyuiyuyhiyyukukyihyuSYSPNNWhgjgghyuyy8yuyughghhghghhjjhhjhjkhjCCmmGGKK}
and \er{apfrm9} we have
\begin{multline}\label{vhfffngghkjgghggtghjgfhjoyuiyuyhiyyukukyihyuSYSPNNWhgjgghyuyy8yuyughghhghghhjjhhjhjkhjCCmmGGKKJJKK}
\frac{\partial}{\partial t}\left\{ \mu \left(\vec u-\vec v\right)
+\frac{1}{4\pi c}\,\vec D\times \vec B\right\}+ div_{\vec
x}\left\{\left(\mu \left(\vec u-\vec v\right)+\frac{1}{4\pi c}\vec
D\times \vec B\right)\otimes \vec v+\vec v\otimes\left(\mu\left(\vec
u-\vec v\right)+\frac{1}{4\pi c}\vec D\times \vec B\right)\right\}
\\+\left(\mu\left(\vec u-\vec v\right)
+\frac{1}{4\pi c}\vec D\times \vec B\right)\times curl_{\vec x}\vec
v-\left(div_{\vec x}\left\{\mu \left(\vec u-\vec v\right)
+\frac{1}{4\pi c}\,\vec D\times \vec B\right\}\right)\vec v=\\
\frac{1}{4\pi}div_\vec x\left\{\vec D\otimes \vec D+\vec B\otimes
\vec B-\frac{1}{2}\left(|\vec D|^2+|\vec B|^2\right)I-4\pi\mu
\left(\vec u-\vec v\right)\otimes(\vec u-\vec v)\right\}+\vec F,
\end{multline}
and by
\er{vhfffngghkjgghggtghjgfhjoyuiyuyhiyyukukyihyuSYSPNNWhgjgghyuyy8yuyughghhghghhjjhhjhjkhjCCmmGGKK}
and \er{apfrm6kkk} we have
\begin{multline}\label{vhfffngghkjgghggtghjgfhjoyuiyuyhiyyukukyihyuSYSPNNWhgjgghyuyy8yuyughghhghghhjjhhjhjkhjCCmmGGKKJJKKMod}
\frac{\partial}{\partial t}\left\{ \mu \left(\vec u-\vec v\right)
+\frac{1}{4\pi c}\,\vec D\times \vec B\right\}- curl_{\vec
x}\left\{\vec v\times\left(\mu\left(\vec u-\vec
v\right)+\frac{1}{4\pi c}\vec D\times \vec B\right)\right\}
\\+\left(div_{\vec x}\left\{\mu \left(\vec u-\vec v\right)
+\frac{1}{4\pi c}\,\vec D\times \vec B\right\}\right)\vec
v+\left(d_{\vec x}\vec v+\left\{d_{\vec x}\vec
v\right\}^T\right)\cdot\left\{\mu\left(\vec u-\vec v\right)
+\frac{1}{4\pi c}\vec D\times \vec B\right\}=\\
\frac{1}{4\pi}div_\vec x\left\{\vec D\otimes \vec D+\vec B\otimes
\vec B-\frac{1}{2}\left(|\vec D|^2+|\vec B|^2\right)I-4\pi\mu
\left(\vec u-\vec v\right)\otimes(\vec u-\vec v)\right\}+\vec F.
\end{multline}
On the other hand for every vector fields $\Gamma:\R^3\to\R^3$ and
$\Lambda:\R^3\to\R^3$ and every scalar field $P:\R^3\to\R$ we have:
\begin{multline}\label{yguiyiu7yytyutkhjzzzyyykkknnnNWBWHWPPNKK}
\vec x\times div_{\vec
x}\{\Gamma\otimes\Lambda+\Lambda\otimes\Gamma\}=div_{\vec
x}\left\{(\vec x\times\Gamma)\otimes\Lambda+(\vec
x\times\Lambda)\otimes\Gamma\right\},\\
\vec x\times div_{\vec x}\{P\Gamma\otimes\Gamma\}=div_{\vec
x}\left\{P(\vec
x\times\Gamma)\otimes\Gamma\right\}\quad\text{and}\quad\vec
x\times\nabla_{\vec x}P=-curl_{\vec x}\{P\,\vec x\}.
\end{multline}
Thus inserting \er{yguiyiu7yytyutkhjzzzyyykkknnnNWBWHWPPNKK} into
\er{vhfffngghkjgghggtghjgfhjoyuiyuyhiyyukukyihyuSYSPNNWhgjgghyuyy8yuyughghhghghhjjhhjhjkhjCCmmGGKKJJKK}
we obtain:
\begin{multline}\label{hvkgkjgkjbjkjjkgjglhhkhjyuyghjhhjhjfghfdhgdhdfdhzzzyyyjffjjkkhhkgjgkjhfhjhgfffjgjhgjffgjggjgjggjhgkkkggjgjgmomnnnNWKK}
\frac{\partial}{\partial t}\left\{\vec x\times\left(\mu(\vec u-\vec
v)+\frac{1}{4\pi c}\,\vec D\times \vec B\right)\right\}+div_\vec
x\left\{\mu\left(\vec x\times(\vec u-\vec v)\right)\otimes(\vec
u-\vec v)\right\}\\+div_{\vec x}\left\{\left(\vec x \times\vec
v\right)\otimes\left(\mu(\vec u-\vec v)+\frac{1}{4\pi c}\vec D\times
\vec B\right)+\left(\vec x\times\left(\mu(\vec u-\vec
v)+\frac{1}{4\pi c}\vec D\times \vec B\right)\right)\otimes \vec
v\right\}\\-\vec x\times\left\{\left(div_{\vec x}\left\{\mu(\vec
u-\vec v)+\frac{1}{4\pi c}\vec D\times \vec B\right\}\right)\vec
v-\left(\mu(\vec u-\vec v)+\frac{1}{4\pi c}\vec
D\times \vec B\right)\times curl_{\vec x}\vec v\right\}\\
=\frac{1}{4\pi}div_\vec x\left\{(\vec x\times \vec D)\otimes \vec
D+(\vec x\times\vec B)\otimes \vec B\right\}+curl_{\vec
x}\left\{\frac{1}{2}\left(|\vec D|^2+|\vec B|^2\right)\vec
x\right\}+\vec x\times\vec F.
\end{multline}

 Next assume that the system $(*)$ is inertial. Then, since by \er{MaxVacFull1ninshtrgravortghhghgjkgghklhjgkghghjjkjhjkkggjkhjkhjjhhfhjhklkhkhjjklzzzyyyNWNWBWHWPPN}
and
\er{MaxVacFull1ninshtrgravortghhghgjkgghklhjgkghghjjkjhjkkggjkhjkhjjhhfhjhkhjjkhjgzzzyyykkknnnNWBWHWPPN}
we have:
\begin{equation}
\label{MaxVacFull1ninshtrgravortghhghgjkgghklhjgkghghjjkjhjkkggjkhjkhjjhhfhjhklkhkhjjklzzzyyyNWNWBWHWPPNKKKK}
\frac{\partial}{\partial t}\left(\mu+Q\right)+div_{\vec
x}\left\{\left(\mu+Q\right)\vec v\right\}=- div_\vec
x\left\{\mu(\vec u-\vec v)+\frac{1}{4\pi c}\vec D\times \vec
B\right\},
\end{equation}
by
\er{MaxVacFull1ninshtrgravortghhghgjkgghklhjgkghghjjkjhjkkggjkhjkhjjhhfhjhklkhkhjjklzzzyyyNWNWBWHWPPNKKKK},
\er{MaxVacFull1ninshtrgravortghhghgjkgghklhjgkghghjjkjhjkkggjkhjkhjjhhfhjhklkhkhjjklzzzyyyhjggjhgghhjhNWNWBWHWPPN}
and
\er{MaxVacFull1ninshtrgravortghhghgjkgghklhjgkghghjjkjhjkkggjkhjkhjjhhfhjhklkhkhjjklzzzyyyhjggjhgghhjhNWNWNWBWHWPPN}
we have
\begin{multline}\label{vhfffngghkjgghggtghjgfhjoyuiyuyhiyyukukyihyuSYSPNNWhgjgghyuyy8yuyughghhghghhjjhhjhjkhjCCmmGGKKJJKKjkhjh}
-\left(div_{\vec x}\left\{\mu \left(\vec u-\vec v\right)
+\frac{1}{4\pi c}\,\vec D\times \vec B\right\}\right)\vec
v=\left(\frac{\partial}{\partial t}\left(\mu+Q\right)\right)\vec
v+\left(div_{\vec x}\left\{\left(\mu+Q\right)\vec
v\right\}\right)\vec v=\\
\frac{\partial}{\partial t}\left(\left(\mu+Q\right)\vec
v\right)+div_{\vec x}\left\{\left(\mu+Q\right)\vec v\otimes\vec
v\right\}-\left(\mu+Q\right)\left(\frac{\partial\vec v}{\partial
t}+d_{\vec x}\vec v\cdot\vec v\right)=\\ \frac{\partial}{\partial
t}\left(\left(\mu+Q\right)\vec v\right)+div_{\vec
x}\left\{\left(\mu+Q\right)\vec v\otimes\vec
v\right\}+\left(\mu+Q\right)\nabla_{\vec x}\Phi=\\
\frac{\partial}{\partial t}\left(\left(\mu+Q\right)\vec
v\right)+div_{\vec x}\left\{\left(\mu+Q\right)\vec v\otimes\vec
v\right\}+\frac{1}{4\pi G}\left(\Delta_{\vec
x}\Phi\right)\nabla_{\vec x}\Phi=\\
\frac{\partial}{\partial t}\left(\left(\mu+Q\right)\vec
v\right)+div_{\vec x}\left\{\left(\mu+Q\right)\vec v\otimes\vec
v\right\}+\frac{1}{4\pi G}div_{\vec x}\left\{\nabla_{\vec
x}\Phi\otimes \nabla_{\vec x}\Phi-\frac{1}{2}\left|\nabla_{\vec
x}\Phi\right|^2I\right\}.
\end{multline}
Moreover, by \er{yguiyiu7yytyutkhjzzzyyykkknnnNWBWHWPPNKK} and
\er{vhfffngghkjgghggtghjgfhjoyuiyuyhiyyukukyihyuSYSPNNWhgjgghyuyy8yuyughghhghghhjjhhjhjkhjCCmmGGKKJJKKjkhjh}
we have
\begin{multline}\label{vhfffngghkjgghggtghjgfhjoyuiyuyhiyyukukyihyuSYSPNNWhgjgghyuyy8yuyughghhghghhjjhhjhjkhjCCmmGGKKJJKKjkhjhyyyuu}
-\left(div_{\vec x}\left\{\mu \left(\vec u-\vec v\right)
+\frac{1}{4\pi c}\,\vec D\times \vec B\right\}\right)\vec
x\times\vec v= \frac{\partial}{\partial
t}\left(\left(\mu+Q\right)\vec x\times\vec v\right)+div_{\vec
x}\left\{\left(\mu+Q\right)\left(\vec x\times\vec
v\right)\otimes\vec v\right\}\\+\frac{1}{4\pi G}div_{\vec
x}\left\{\left(\vec x\times\nabla_{\vec x}\Phi\right)\otimes
\nabla_{\vec x}\Phi\right\}+\frac{1}{8\pi G}curl_{\vec
x}\left\{\left|\nabla_{\vec x}\Phi\right|^2\vec x\right\}.
\end{multline}
Therefore, by inserting
\er{vhfffngghkjgghggtghjgfhjoyuiyuyhiyyukukyihyuSYSPNNWhgjgghyuyy8yuyughghhghghhjjhhjhjkhjCCmmGGKKJJKKjkhjh}
into
\er{vhfffngghkjgghggtghjgfhjoyuiyuyhiyyukukyihyuSYSPNNWhgjgghyuyy8yuyughghhghghhjjhhjhjkhjCCmmGGKKJJKK}
and using
\er{MaxVacFull1ninshtrgravortghhghgjkgghklhjgkghghjjkjhjkkggjkhjkhjjhhfhjhklkhkhjjklzzzyyyhjggjhgghhjhNWNWBWHWPPN}
we deduce the following conservation of the momentum:
\begin{multline}\label{vhfffngghkjgghggtghjgfhjoyuiyuyhiyyukukyihyuSYSPNNWhgjgghyuyy8yuyughghhghghhjjhhjhjkhjCCmmGGKKJJKKhghgghghgh}
\frac{\partial}{\partial t}\left\{ \mu\vec u+Q\vec v +\frac{1}{4\pi
c}\,\vec D\times \vec B\right\}+ div_{\vec x}\left\{\mu\vec
u\otimes\vec u+Q\vec v\otimes\vec v+\left(\frac{1}{4\pi c}\vec
D\times \vec B\right)\otimes \vec v+\vec v\otimes\left(\frac{1}{4\pi
c}\vec D\times \vec B\right)\right\}
\\+\frac{1}{4\pi G}div_{\vec x}\left\{\nabla_{\vec
x}\Phi\otimes \nabla_{\vec x}\Phi-\frac{1}{2}\left|\nabla_{\vec
x}\Phi\right|^2I\right\}-div_\vec x\left\{\mu
\left(\vec u-\vec v\right)\otimes(\vec u-\vec v)\right\}=\\
\frac{\partial}{\partial t}\left\{ \mu \left(\vec u-\vec v\right)
+\frac{1}{4\pi c}\,\vec D\times \vec B\right\}+ div_{\vec
x}\left\{\left(\mu \left(\vec u-\vec v\right)+\frac{1}{4\pi c}\vec
D\times \vec B\right)\otimes \vec v+\vec v\otimes\left(\mu\left(\vec
u-\vec v\right)+\frac{1}{4\pi c}\vec D\times \vec B\right)\right\}
\\+\frac{\partial}{\partial t}\left(\left(\mu+Q\right)\vec
v\right)+div_{\vec x}\left\{\left(\mu+Q\right)\vec v\otimes\vec
v\right\}+\frac{1}{4\pi G}div_{\vec x}\left\{\nabla_{\vec
x}\Phi\otimes \nabla_{\vec x}\Phi-\frac{1}{2}\left|\nabla_{\vec
x}\Phi\right|^2I\right\}=\\
\frac{1}{4\pi}div_\vec x\left\{\vec D\otimes \vec D+\vec B\otimes
\vec B-\frac{1}{2}\left(|\vec D|^2+|\vec B|^2\right)I-4\pi\mu
\left(\vec u-\vec v\right)\otimes(\vec u-\vec v)\right\}+\vec F,
\end{multline}
and by inserting
\er{vhfffngghkjgghggtghjgfhjoyuiyuyhiyyukukyihyuSYSPNNWhgjgghyuyy8yuyughghhghghhjjhhjhjkhjCCmmGGKKJJKKjkhjhyyyuu}
into
\er{hvkgkjgkjbjkjjkgjglhhkhjyuyghjhhjhjfghfdhgdhdfdhzzzyyyjffjjkkhhkgjgkjhfhjhgfffjgjhgjffgjggjgjggjhgkkkggjgjgmomnnnNWKK}
and using
\er{MaxVacFull1ninshtrgravortghhghgjkgghklhjgkghghjjkjhjkkggjkhjkhjjhhfhjhklkhkhjjklzzzyyyhjggjhgghhjhNWNWBWHWPPN}
we deduce the following conservation of the angular momentum:
\begin{multline}\label{hvkgkjgkjbjkjjkgjglhhkhjyuyghjhhjhjfghfdhgdhdfdhzzzyyyjffjjkkhhkgjgkjhfhjhgfffjgjhgjffgjggjgjggjhgkkkggjgjgmomnnnNWKKhgjggh}
\frac{\partial}{\partial t}\left\{\vec x\times\left(\mu\vec u+Q\vec
v+\frac{1}{4\pi c}\,\vec D\times \vec B\right)\right\}+\frac{1}{4\pi
G}div_{\vec x}\left\{\left(\vec x\times\nabla_{\vec
x}\Phi\right)\otimes \nabla_{\vec x}\Phi\right\}+\frac{1}{8\pi
G}curl_{\vec x}\left\{\left|\nabla_{\vec x}\Phi\right|^2\vec
x\right\}\\+div_{\vec x}\left\{\mu\left(\vec x\times\vec
u\right)\otimes\vec u+Q\left(\vec x\times\vec v\right)\otimes\vec
v+\left(\vec x \times\vec v\right)\otimes\left(\frac{1}{4\pi c}\vec
D\times \vec B\right)+\left(\vec x\times\left(\frac{1}{4\pi c}\vec
D\times \vec
B\right)\right)\otimes \vec v\right\}=\\
\frac{\partial}{\partial t}\left\{\vec x\times\left(\mu(\vec u-\vec
v)+\frac{1}{4\pi c}\,\vec D\times \vec B\right)\right\}+div_\vec
x\left\{\mu\left(\vec x\times(\vec u-\vec v)\right)\otimes(\vec
u-\vec v)\right\}\\+div_{\vec x}\left\{\left(\vec x \times\vec
v\right)\otimes\left(\mu(\vec u-\vec v)+\frac{1}{4\pi c}\vec D\times
\vec B\right)+\left(\vec x\times\left(\mu(\vec u-\vec
v)+\frac{1}{4\pi c}\vec D\times \vec B\right)\right)\otimes \vec
v\right\}\\+\frac{\partial}{\partial t}\left(\left(\mu+Q\right)\vec
x\times\vec v\right)+div_{\vec x}\left\{\left(\mu+Q\right)\left(\vec
x\times\vec v\right)\otimes\vec v\right\}\\+\frac{1}{4\pi
G}div_{\vec x}\left\{\left(\vec x\times\nabla_{\vec
x}\Phi\right)\otimes \nabla_{\vec x}\Phi\right\}+\frac{1}{8\pi
G}curl_{\vec
x}\left\{\left|\nabla_{\vec x}\Phi\right|^2\vec x\right\}\\
=\frac{1}{4\pi}div_\vec x\left\{(\vec x\times \vec D)\otimes \vec
D+(\vec x\times\vec B)\otimes \vec B\right\}+curl_{\vec
x}\left\{\frac{1}{2}\left(|\vec D|^2+|\vec B|^2\right)\vec
x\right\}+\vec x\times\vec F.
\end{multline}
Finally, by
\er{MaxVacFull1ninshtrgravortghhghgjkgghklhjgkghghjjkjhjkkggjkhjkhjjhhfhjhklkhkhjjklzzzyyyNWNWBWHWPPNKKKK},
\er{MaxVacFull1ninshtrgravortghhghgjkgghklhjgkghghjjkjhjkkggjkhjkhjjhhfhjhklkhkhjjklzzzyyyhjggjhgghhjhNWNWBWHWPPN}
and
\er{MaxVacFull1ninshtrgravortghhghgjkgghklhjgkghghjjkjhjkkggjkhjkhjjhhfhjhklkhkhjjklzzzyyyhjggjhgghhjhNWNWNWBWHWPPN}
we have
\begin{multline}\label{vhfffngghkjgghggtghjgfhjoyuiyuyhiyyukukyihyuSYSPNNWhgjgghyuyy8yuyughghhyttyytgghhghghgghjhkjhCCmmGGKKuiiuuihjgghghhhjhjhjhj}
\frac{\partial\vec v}{\partial t}\cdot\left( \mu
 \left(\vec
u-\vec v\right) +\frac{1}{4\pi c}\,\vec D\times \vec
B\right)=-\nabla_{\vec x}\left(\Phi+\frac{1}{2}\left|\vec
v\right|^2\right)\cdot \left( \mu
 \left(\vec
u-\vec v\right) +\frac{1}{4\pi c}\,\vec D\times \vec B\right)=\\
\left(\Phi+\frac{1}{2}\left|\vec v\right|^2\right) \left(div_{\vec
x}\left\{ \mu
 \left(\vec
u-\vec v\right) +\frac{1}{4\pi c}\,\vec D\times \vec
B\right\}\right)-div_{\vec x}\left\{\left(\Phi+\frac{1}{2}\left|\vec
v\right|^2\right)\left( \mu
 \left(\vec
u-\vec v\right) +\frac{1}{4\pi c}\,\vec D\times \vec
B\right)\right\}=\\
-\left(\Phi+\frac{1}{2}\left|\vec v\right|^2\right)
\left(\frac{\partial}{\partial t}\left(\mu+Q\right)+div_{\vec
x}\left\{\left(\mu+Q\right)\vec v\right\}\right)-div_{\vec
x}\left\{\left(\Phi+\frac{1}{2}\left|\vec v\right|^2\right)\left(
\mu
 \left(\vec
u-\vec v\right) +\frac{1}{4\pi c}\,\vec D\times \vec
B\right)\right\}\\= -\left(\Phi+\frac{1}{2}\left|\vec
v\right|^2\right) \left(\frac{\partial}{\partial
t}\left(\mu+Q\right)\right)+\left(\mu+Q\right)\nabla_{\vec
x}\left(\Phi+\frac{1}{2}\left|\vec v\right|^2\right)\cdot\vec v-
div_{\vec x}\left\{\left(\Phi+\frac{1}{2}\left|\vec
v\right|^2\right)\left(\mu+Q\right)\vec v\right\}\\-div_{\vec
x}\left\{\left(\Phi+\frac{1}{2}\left|\vec v\right|^2\right)\left(
\mu
 \left(\vec
u-\vec v\right) +\frac{1}{4\pi c}\,\vec D\times \vec
B\right)\right\}=-\frac{1}{4\pi G}\Phi
\left(\frac{\partial}{\partial
t}\left(\Delta_{\vec x}\Phi\right)\right)\\
-\left(\frac{1}{2}\left|\vec v\right|^2\right)
\left(\frac{\partial}{\partial
t}\left(\mu+Q\right)\right)-\left(\mu+Q\right)\frac{\partial\vec
v}{\partial t}\cdot\vec v- div_{\vec
x}\left\{\left(\Phi+\frac{1}{2}\left|\vec
v\right|^2\right)\left(\mu+Q\right)\vec v\right\}\\-div_{\vec
x}\left\{\left(\Phi+\frac{1}{2}\left|\vec v\right|^2\right)\left(
\mu
 \left(\vec
u-\vec v\right) +\frac{1}{4\pi c}\,\vec D\times \vec
B\right)\right\}=\frac{1}{8\pi G} \frac{\partial}{\partial
t}\left(\left|\nabla_{\vec x}\Phi\right|^2\right) -
\frac{\partial}{\partial
t}\left(\frac{1}{2}\left(\mu+Q\right)\left|\vec
v\right|^2\right)\\-\frac{1}{4\pi G}div_{\vec x}\left\{\Phi
\frac{\partial}{\partial t}\left(\nabla_{\vec
x}\Phi\right)\right\}-div_{\vec
x}\left\{\left(\Phi+\frac{1}{2}\left|\vec v\right|^2\right)\left(
\mu \vec u+Q\vec v+\frac{1}{4\pi c}\,\vec D\times \vec
B\right)\right\}
\end{multline}
Then by inserting
\er{vhfffngghkjgghggtghjgfhjoyuiyuyhiyyukukyihyuSYSPNNWhgjgghyuyy8yuyughghhyttyytgghhghghgghjhkjhCCmmGGKKuiiuuihjgghghhhjhjhjhj}
into
\er{vhfffngghkjgghggtghjgfhjoyuiyuyhiyyukukyihyuSYSPNNWhgjgghyuyy8yuyughghhyttyytgghhghghgghjhkjhCCmmGGKKuiiuui}
we deuce:
\begin{multline}\label{vhfffngghkjgghggtghjgfhjoyuiyuyhiyyukukyihyuSYSPNNWhgjgghyuyy8yuyughghhyttyytgghhghghgghjhkjhCCmmGGKKuiiuuihjgghgh}
\frac{\partial}{\partial t}\left\{ \frac{\mu}{2}\left|\vec u-\vec
v\right|^2+\frac{|\vec D|^2+|\vec B|^2}{8\pi}+\vec v\cdot\left( \mu
 \left(\vec
u-\vec v\right) +\frac{1}{4\pi c}\,\vec D\times \vec
B\right)\right\}\\+div_{\vec x}\left\{\left(\frac{\mu}{2}\left|\vec
u-\vec v\right|^2+\left(\frac{|\vec D|^2+|\vec
B|^2}{8\pi}\right)\right)\vec v\right\}+div_{\vec
x}\left\{\frac{\mu}{2}\left|\vec u-\vec v\right|^2(\vec u-\vec
v)+\frac{c}{4\pi}\vec D\times \vec B\right\}\\= \frac{1}{8\pi G}
\frac{\partial}{\partial t}\left(\left|\nabla_{\vec
x}\Phi\right|^2\right) - \frac{\partial}{\partial
t}\left(\frac{1}{2}\left(\mu+Q\right)\left|\vec
v\right|^2\right)\\-\frac{1}{4\pi G}div_{\vec x}\left\{\Phi
\frac{\partial}{\partial t}\left(\nabla_{\vec
x}\Phi\right)\right\}-div_{\vec
x}\left\{\left(\Phi+\frac{1}{2}\left|\vec v\right|^2\right)\left(
\mu \vec u+Q\vec v+\frac{1}{4\pi c}\,\vec D\times \vec
B\right)\right\}\\- div_{\vec x}\left\{\left(\left(\mu\left(\vec
u-\vec v\right)+\frac{1}{4\pi c}\vec D\times \vec B\right)\cdot\vec
v\right)\vec v\right\}\\-div_{\vec x}\left\{\mu \left(\left(\vec
u-\vec v\right)\cdot\vec v\right) \left(\vec u-\vec
v\right)\right\}+\frac{1}{4\pi}div_{\vec x}\left\{\left(\vec
D\otimes \vec D+\vec B\otimes \vec B-\frac{1}{2}\left(|\vec
D|^2+|\vec B|^2\right)I\right)\cdot\vec v
\right\}
+\vec u\cdot\vec F.
\end{multline}
Then, using \er{apfrm1} and the last two equalities in
\er{MaxVacFull1bjkgjhjhgjaaahkjhhjzzzyyykkknnnNWBWHWPPN},  we
rewrite
\er{vhfffngghkjgghggtghjgfhjoyuiyuyhiyyukukyihyuSYSPNNWhgjgghyuyy8yuyughghhyttyytgghhghghgghjhkjhCCmmGGKKuiiuuihjgghgh}
in the form of the following conservation of the energy:
\begin{multline}\label{vhfffngghkjgghggtghjgfhjoyuiyuyhiyyukukyihyuSYSPNNWhgjgghyuyy8yuyughghhyttyytgghhghghgghjhkjhCCmmGGKKuiiuuihjgghghiuiuiui}
\frac{\partial}{\partial t}\left\{ \frac{\mu}{2}\left|\vec
u\right|^2+\frac{Q}{2}\left|\vec v\right|^2+\frac{\vec D\cdot\vec
E+\vec B\cdot\vec H}{8\pi}-\frac{1}{8\pi G} \left|\nabla_{\vec
x}\Phi\right|^2\right\}\\+div_{\vec x}\left\{\frac{\mu}{2}\left|\vec
u\right|^2\vec u+\frac{Q}{2}\left|\vec v\right|^2\vec
v+\left(\frac{\vec D\cdot\vec E+\vec B\cdot\vec H}{8\pi}\right)\vec
v+\frac{1}{8\pi c}\left|\vec v\right|^2\left(\vec D\times \vec
B\right)\right\}+div_{\vec x}\left\{\frac{c}{4\pi}\vec D\times \vec
B\right\}\\=-\frac{1}{4\pi G}div_{\vec x}\left\{\Phi
\frac{\partial}{\partial t}\left(\nabla_{\vec
x}\Phi\right)\right\}-div_{\vec x}\left\{\Phi\left( \mu \vec u+Q\vec
v+\frac{1}{4\pi c}\,\vec D\times \vec
B\right)\right\}\\+\frac{1}{4\pi}div_{\vec x}\left\{\left(\vec
D\otimes \vec D+\vec B\otimes \vec B-\frac{1}{2}\left(|\vec
D|^2+|\vec B|^2\right)I\right)\cdot\vec v
\right\}
+\vec u\cdot\vec F.
\end{multline}
As a consequence of
\er{vhfffngghkjgghggtghjgfhjoyuiyuyhiyyukukyihyuSYSPNNWhgjgghyuyy8yuyughghhghghhjjhhjhjkhjCCmmGGKKJJKKhghgghghgh},
\er{hvkgkjgkjbjkjjkgjglhhkhjyuyghjhhjhjfghfdhgdhdfdhzzzyyyjffjjkkhhkgjgkjhfhjhgfffjgjhgjffgjggjgjggjhgkkkggjgjgmomnnnNWKKhgjggh}
and
\er{vhfffngghkjgghggtghjgfhjoyuiyuyhiyyukukyihyuSYSPNNWhgjgghyuyy8yuyughghhyttyytgghhghghgghjhkjhCCmmGGKKuiiuuihjgghghiuiuiui}
we infer that we have the following proposition:
\begin{proposition}
Consider the Maxwell equation for the vacuum in the form
\er{MaxVacFull1bjkgjhjhgjaaahkjhhjzzzyyykkknnnNWBWHWPPN} and the
second Law of Newton for the moving continuum in the form
\er{MaxVacFull1ninshtrgravortghhghgjkgghklhjgkghghjjkjhjkkggjkhjkhjjhhfhjhkjkhbbgjhzzzyyykkknnnNWBWHWPPN}.
Next, assume that in some cartesian coordinate system $(*)$ we
observe the gravitational law in the form of
\er{MaxVacFull1ninshtrgravortghhghgjkgghklhjgkghghjjkjhjkkggjkhjkhjjhhfhjhklkhkhjjklzzzyyyhjggjhgghhjhNWNWBWHWPPN},
\er{MaxVacFull1ninshtrgravortghhghgjkgghklhjgkghghjjkjhjkkggjkhjkhjjhhfhjhklkhkhjjklzzzyyyhjggjhgghhjhNWNWNWBWHWPPN}
and
\er{MaxVacFull1ninshtrgravortghhghgjkgghklhjgkghghjjkjhjkkggjkhjkhjjhhfhjhklkhkhjjklzzzyyyNWNWBWHWPPN}.
Then in the system $(*)$ we have the following conservation laws of
the linear and angular momentums and energy:
\begin{multline}\label{hvkgkjgkjbjkjjkgjglhhkhjyuyghjhhjhjfghfdhgdhdfdhzzzyyyjffjjkkhhkgjgkjhfhjhgfffjgjhgjffgjggjgjggjhgkkkggjgjgnnnmmmNWNWNWNWBWHWMPPN}
\frac{\partial}{\partial t}\left(\mu\vec u+Q\vec v+\frac{1}{4\pi
c}\,\vec D\times \vec B\right)=\\-div_\vec x\left\{\mu\vec
u\otimes\vec u+Q\vec v\otimes\vec v+\left(\frac{1}{4\pi c}\vec
D\times \vec B\right)\otimes \vec v+\vec v\otimes\left(\frac{1}{4\pi
c}\vec D\times \vec
B\right)\right\}\\
+\frac{1}{4\pi}div_\vec x\left\{\vec D\otimes \vec D+\vec B\otimes
\vec B-\frac{1}{2}\left(|\vec D|^2+|\vec
B|^2\right)I-\frac{1}{G}\nabla_{\vec x}\Phi\otimes\nabla_{\vec
x}\Phi+\frac{1}{2G}\left|\nabla_{\vec x}\Phi\right|^2 I\right\}+\vec
F,
\end{multline}
\begin{multline}\label{hvkgkjgkjbjkjjkgjglhhkhjyuyghjhhjhjfghfdhgdhdfdhzzzyyyjffjjkkhhkgjgkjhfhjhgfffjgjhgjffgjggjgjggjhgkkkggjgjgmomnnnmmmNWNWBWHWMMPPN}
\frac{\partial}{\partial t}\left(\vec x\times(\mu\vec u)+\vec
x\times(Q\vec v)+\vec x\times\left(\frac{1}{4\pi c}\,\vec D\times
\vec B\right)\right)=\\-div_\vec x\left\{\mu(\vec x\times\vec
u)\otimes\vec u+Q(\vec x\times\vec v)\otimes\vec v+\left(\vec
x\times\left(\frac{1}{4\pi c}\vec D\times \vec
B\right)\right)\otimes \vec v+(\vec x\times\vec
v)\otimes\left(\frac{1}{4\pi c}\vec D\times \vec
B\right)\right\}\\
+\frac{1}{4\pi}div_\vec x\left\{(\vec x\times\vec D)\otimes \vec
D+(\vec x\times\vec B)\otimes \vec B-\frac{1}{G}(\vec
x\times\nabla_{\vec x}\Phi)\otimes\nabla_{\vec
x}\Phi\right\}\\+\frac{1}{8\pi}curl_{\vec x}\left\{\left(|\vec
D|^2+|\vec B|^2-\frac{1}{G}\left|\nabla_{\vec
x}\Phi\right|^2\right)\vec x\right\}+ \vec x\times \vec F,
\end{multline}
and
\begin{multline}\label{MaxVacFull1ninshtrgravortghhghgjkgghklhjgkghghjjkjhjkkggjkhjkhjjhhfhjhkjkhbbgjhhkjhhklhzzzyyykkkgkhjjhgfhjjffghuikkgkjghhjkjknnnmmmNWNWBWHWhhjhhENPPN}
\frac{\partial}{\partial t}\left(\frac{\mu|\vec
u|^2}{2}+\frac{Q}{2}\left|\vec v\right|^2+\frac{\vec D\cdot\vec
E+\vec B\cdot\vec H}{8\pi}-\frac{1}{8\pi G}\big|\nabla_{\vec
x}\Phi\big|^2\right)=\\
=-div_\vec x\left\{\left(\frac{\mu|\vec u|^2}{2}\right)\vec
u+\left(\frac{Q|\vec v|^2}{2}\right)\vec v+\frac{1}{2}\left|\vec
v\right|^2\left(\frac{1}{4\pi c}\,\vec D\times \vec
B\right)+\left(\frac{\vec D\cdot\vec E+\vec B\cdot\vec
H}{8\pi}\right)\vec v\right\}
\\
+\frac{1}{4\pi}div_\vec x\left\{(\vec D\otimes \vec D+ \vec B\otimes
\vec B)\cdot \vec v-\frac{1}{2}\left(|\vec D|^2+|\vec
B|^2\right)\vec v-c \vec D\times \vec B\right\}\\-div_\vec
x\left\{\Phi\left(\mu\vec u+Q\vec v+\frac{1}{4\pi c}\,\vec D\times
\vec B\right)\right\}
-\frac{1}{4\pi G}div_{\vec x}\left\{\Phi\frac{\partial}{\partial
t}(\nabla_{\vec x}\Phi)\right\}+\vec F\cdot\vec u.
\end{multline}
%
%
%
%
%
%
\end{proposition}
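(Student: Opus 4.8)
The plan is to prove all three balance laws by a single uniform strategy: rewrite every force entering the second Law of Newton \er{MaxVacFull1ninshtrgravortghhghgjkgghklhjgkghghjjkjhjkkggjkhjkhjjhhfhjhkjkhbbgjhzzzyyykkknnnNWBWHWPPN} as the spatial divergence of an appropriate stress tensor (plus the externally prescribed $\vec F$), so that the time derivative of the conserved density equals a pure divergence. The mechanical, electromagnetic, and gravitational contributions will be handled separately and then summed. Throughout I would use the inertial-frame hypotheses \er{MaxVacFull1ninshtrgravortghhghgjkgghklhjgkghghjjkjhjkkggjkhjkhjjhhfhjhklkhkhjjklzzzyyyhjggjhgghhjhNWNWBWHWPPN}, \er{MaxVacFull1ninshtrgravortghhghgjkgghklhjgkghghjjkjhjkkggjkhjkhjjhhfhjhklkhkhjjklzzzyyyhjggjhgghhjhNWNWNWBWHWPPN} and \er{MaxVacFull1ninshtrgravortghhghgjkgghklhjgkghghjjkjhjkkggjkhjkhjjhhfhjhklkhkhjjklzzzyyyNWNWBWHWPPN}, that is $curl_{\vec x}\vec v=0$, the Euler-type relation $\partial_t\vec v+d_{\vec x}\vec v\cdot\vec v=-\nabla_{\vec x}\Phi$, the Poisson equation $\Delta_{\vec x}\Phi=4\pi G(\mu+Q)$, and the transport law for $Q$.

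First I would recast the equation of motion in conservation form: using identity \er{apfrm9} to treat $d_{\vec x}\vec v\cdot\vec v$ and the curl terms, together with the mass-continuity equation, I pass to the momentum balance for the relative velocity $(\vec u-\vec v)$, obtaining \er{vhfffngghkjgghggtghjgfhjoyuiyuyhiyyukukyihyuSYSPNNWhgjgghyuyy8yuyughghhghghhjjhCCmmGGKK}. For the electromagnetic part I would invoke the two Appendix identities, Lemma \ref{guigiukhn} and Lemma \ref{fgbfghfh}, which express the time derivatives of the Poynting momentum $\frac{1}{4\pi c}\vec D\times\vec B$ and of the field energy through the divergence of the Maxwell stress tensor, namely \er{hvkgkjgkjbjkjjkgjglhhkhjyuyghjhhjhjzzzyyykkknnnNWBWHWPPNNewZZZghghghCCmmGGKK} and \er{hvkgkjgkjbjbbklnknhihiokhhfjffghvjmbjhjkhlkzzzyyykkknnnNWBWHWPPNNewZZZgghghCCmmGGKK}; adding these to the mechanical balance cancels the Lorentz force and produces \er{vhfffngghkjgghggtghjgfhjoyuiyuyhiyyukukyihyuSYSPNNWhgjgghyuyy8yuyughghhghghhjjhhjhjkhjCCmmGGKK}.

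The gravitational step is the substantive one. Using the inertial-frame transport identity \er{MaxVacFull1ninshtrgravortghhghgjkgghklhjgkghghjjkjhjkkggjkhjkhjjhhfhjhklkhkhjjklzzzyyyNWNWBWHWPPNKKKK} for $\mu+Q$ together with the Euler relation and the Poisson equation, I would rewrite the term $-\left(div_{\vec x}\{\mu(\vec u-\vec v)+\frac{1}{4\pi c}\vec D\times\vec B\}\right)\vec v$ as the time derivative of $(\mu+Q)\vec v$ plus the divergence of the Newtonian gravitational stress tensor $\frac{1}{4\pi G}\left(\nabla_{\vec x}\Phi\otimes\nabla_{\vec x}\Phi-\frac12|\nabla_{\vec x}\Phi|^2 I\right)$; this is \er{vhfffngghkjgghggtghjgfhjoyuiyuyhiyyukukyihyuSYSPNNWhgjgghyuyy8yuyughghhghghhjjhhjhjkhjCCmmGGKKJJKKjkhjh}, whose insertion yields the momentum law \er{vhfffngghkjgghggtghjgfhjoyuiyuyhiyyukukyihyuSYSPNNWhgjgghyuyy8yuyughghhghghhjjhhjhjkhjCCmmGGKKJJKKhghgghghgh}. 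For the angular momentum I would cross the position $\vec x$ with the symmetric-flux form \er{vhfffngghkjgghggtghjgfhjoyuiyuyhiyyukukyihyuSYSPNNWhgjgghyuyy8yuyughghhghghhjjhhjhjkhjCCmmGGKKJJKK}, using the three commutation identities \er{yguiyiu7yytyutkhjzzzyyykkknnnNWBWHWPPNKK} to pull $\vec x\times$ inside every divergence and curl, arriving at \er{hvkgkjgkjbjkjjkgjglhhkhjyuyghjhhjhjfghfdhgdhdfdhzzzyyyjffjjkkhhkgjgkjhfhjhgfffjgjhgjffgjggjgjggjhgkkkggjgjgmomnnnNWKKhgjggh}. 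The energy law follows by dotting \er{vhfffngghkjgghggtghjgfhjoyuiyuyhiyyukukyihyuSYSPNNWhgjgghyuyy8yuyughghhCCmmGGKK} with $(\vec u-\vec v)$, adding the field-energy identity, and converting the gravitational work via \er{vhfffngghkjgghggtghjgfhjoyuiyuyhiyyukukyihyuSYSPNNWhgjgghyuyy8yuyughghhyttyytgghhghghgghjhkjhCCmmGGKKuiiuuihjgghghhhjhjhjhj}.

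The hard part will be the gravitational self-force in the energy identity. Unlike the momentum and angular-momentum cases, where the Newtonian stress tensor absorbs the whole gravitational contribution into a pure divergence, the energy balance generates the genuinely time-dependent quantity $\frac{1}{4\pi G}\partial_t(\nabla_{\vec x}\Phi)$, which cannot be written as the divergence of a flux alone and survives as the term $-\frac{1}{4\pi G}div_{\vec x}\{\Phi\,\partial_t(\nabla_{\vec x}\Phi)\}$ on the right-hand side of \er{MaxVacFull1ninshtrgravortghhghgjkgghklhjgkghghjjkjhjkkggjkhjkhjjhhfhjhkjkhbbgjhhkjhhklhzzzyyykkkgkhjjhgfhjjffghuikkgkjghhjkjknnnmmmNWNWBWHWhhjhhENPPN}. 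Tracking the sign and grouping of this boundary-type term, while simultaneously using \er{apfrm1} and the constitutive relations $\vec E=\vec D-\frac1c\vec v\times\vec B$, $\vec H=\vec B+\frac1c\vec v\times\vec D$ to replace $|\vec D|^2+|\vec B|^2$ by $\vec D\cdot\vec E+\vec B\cdot\vec H$ in the energy density, is the most delicate bookkeeping; this is precisely where the inertial-frame hypothesis is used in full strength, since only then do the decomposition $\vec v=\nabla_{\vec x}Z$ and the clean Poisson equation make these manipulations legitimate.
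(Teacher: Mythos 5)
Your proposal is correct and takes essentially the same route as the paper's own proof: the same conservation-form recasting of Newton's law via \er{apfrm9} and the continuity equation, the same Appendix Lemmas \ref{guigiukhn} and \ref{fgbfghfh} for the electromagnetic momentum and energy, the same inertial-frame conversion of $-\bigl(div_{\vec x}\bigl\{\mu(\vec u-\vec v)+\frac{1}{4\pi c}\vec D\times\vec B\bigr\}\bigr)\vec v$ into $\partial_t\bigl((\mu+Q)\vec v\bigr)$ plus the Newtonian gravitational stress, and the same closing steps (crossing with $\vec x$ for angular momentum, dotting with $\vec u-\vec v$ and tracking the $\Phi\,\partial_t(\nabla_{\vec x}\Phi)$ flux for energy). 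The intermediate identities you invoke are exactly those established in Section \ref{GravElectro}, so there is no gap.
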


\section{Lagrangian of the unified Gravitational-Electromagnetic field}\label{ghjfhgfdhd} Given known the distribution of inertial mass
density of some continuum medium $\mu:=\mu(\vec x,t)$, the field of
velocities of this medium $\vec u:=\vec u(\vec x,t)$, the charge
density $\rho:=\rho(\vec x,t)$ and the current density $\vec j:=\vec
j(\vec x,t)$ consider a Lagrangian density $L$ defined by
\begin{multline}\label{vhfffngghkjgghDD}
L\left(\vec A,\Psi,\vec v,\Phi,\vec p,\vec
x,t\right):=\frac{1}{8\pi}\left|-\nabla_{\vec
x}\Psi-\frac{1}{c}\frac{\partial\vec A}{\partial t}+\frac{1}{c}\vec
v\times curl_{\vec x}\vec A\right|^2-\frac{1}{8\pi}\left|curl_{\vec
x}\vec A\right|^2-\left(\rho\Psi-\frac{1}{c}\vec A\cdot\vec
j\right)\\+\frac{\mu}{2}\left|\vec u-\vec v\right|^2+
\frac{1}{2}\left(d_{\vec x}\vec v+\left\{d_{\vec x}\vec
v\right\}^T\right):\left(d_{\vec x}\vec p+\left\{d_{\vec x}\vec
p\right\}^T\right)-2\left(div_{\vec x}\vec v\right)\left(div_{\vec
x}\vec p\right)\\+\frac{1}{4\pi G}\left(div_{\vec x}\vec
v\right)\left(\frac{\partial \Phi}{\partial t}+\vec
v\cdot\nabla_{\vec x} \Phi\right)+\frac{1}{4\pi
G}\Phi\left(div_{\vec x}\vec v\right)^2-\frac{\Phi}{16\pi
G}\left|d_{\vec x}\vec v+\left\{d_{\vec x}\vec
v\right\}^T\right|^2+\frac{1}{8\pi G}\left|\nabla_{\vec
x}\Phi\right|^2,
\end{multline}
where $\Phi$ is an ancillary proper scalar field and $\vec p$ is an
ancillary proper vector field. Then $L$ is invariant under the
change of inertial or non-inertial cartesian coordinate system. We
investigate critical points of the functional
\begin{equation}\label{btfffygtgyggyDD}
J=\int_0^T\int_{\mathbb{R}^3}L\left(\vec A,\Psi,\vec v,\Phi,\vec
p,\vec x,t\right)d\vec x dt.
\end{equation}
We denote
\begin{equation}\label{guigjgjffghDD}
\begin{cases}
\vec D=-\nabla_{\vec x}\Psi-\frac{1}{c}\frac{\partial\vec
A}{\partial t}+\frac{1}{c}\vec
v\times curl_{\vec x}\vec A\\
\vec B=curl_{\vec x}\vec A
\\
\vec E=-\nabla_{\vec x}\Psi-\frac{1}{c}\frac{\partial\vec A}{\partial t}=\vec D-\frac{1}{c}\vec v\times\vec B\\
\vec H=curl_{\vec x}\vec A+\frac{1}{c}\vec
v\times\left(-\nabla_{\vec x}\Psi-\frac{1}{c}\frac{\partial\vec
A}{\partial t}+\frac{1}{c}\vec v\times curl_{\vec x}\vec
A\right)=\vec B+\frac{1}{c}\vec v\times\vec D.
\end{cases}
\end{equation}
Then by \er{guigjgjffghDD} we have:
\begin{equation}\label{guigjgjffghjhkkgDD}
\begin{cases}
curl_{\vec x}\vec E+\frac{1}{c}\frac{\partial\vec B}{\partial t}=0\\
div_{\vec x}\vec B=0.
\end{cases}
\end{equation}
Moreover by \er{vhfffngghkjgghDD},  \er{apfrm3} and \er{apfrm6} we
have
\begin{equation}\label{vhfffngghkjgghggtghjgfhjyuiutDD}
\frac{\delta L}{\delta \vec p}=-div_{\vec x}\left(d_{\vec x}\vec
v+\left\{d_{\vec x}\vec v\right\}^T\right)+2\nabla_{\vec
x}\left(div_{\vec x}\vec v\right)=curl_{\vec x}\left(curl_{\vec
x}\vec v\right)=0,
\end{equation}
\begin{equation}\label{vhfffngghkjgghggtghjgfhjDD}
\frac{\delta L}{\delta \Phi}=-\frac{1}{4\pi
G}\left(\frac{\partial}{\partial t}\left\{div_{\vec x}\vec
v\right\}+\vec v\cdot\nabla_{\vec x}\left(div_{\vec x}\vec
v\right)+\frac{1}{4}\left|d_{\vec x}\vec v+\left\{d_{\vec x}\vec
v\right\}^T\right|^2\right)-\frac{1}{4\pi G}\Delta_{\vec x}\Phi=0,
\end{equation}
\begin{multline}\label{vhfffngghkjgghggtghjgfhjoyuiyuDD}
\frac{\delta L}{\delta \vec v}=-\left(\mu\vec u-\mu\vec
v+\frac{1}{4\pi c}\vec D\times\vec B\right)-div_{\vec
x}\left(d_{\vec x}\vec p+\left\{d_{\vec x}\vec
p\right\}^T\right)+2\nabla_{\vec x}\left(div_{\vec x}\vec
p\right)\\+\frac{1}{4\pi G}div_{\vec x}\left\{\left(d_{\vec x}\vec
v+\left\{d_{\vec x}\vec v\right\}^T\right)\Phi\right\}-\frac{1}{2\pi
G}\nabla_{\vec x}\left(\Phi\left(div_{\vec x}\vec
v\right)\right)-\frac{1}{4\pi G}\nabla_{\vec x}\left(\frac{\partial
\Phi}{\partial t}+\vec v\cdot\nabla_{\vec x}
\Phi\right)\\+\frac{1}{4\pi G}\left(div_{\vec x}\vec
v\right)\nabla_{\vec x}\Phi=-\left(\mu\vec u-\mu\vec v+\frac{1}{4\pi
c}\vec D\times\vec B\right)+curl_{\vec x}\left(curl_{\vec x}\vec
p\right)-\frac{1}{4\pi G}\Phi\,curl_{\vec x}\left(curl_{\vec x}\vec
v\right)\\-\frac{1}{4\pi G}\left(\frac{\partial}{\partial
t}\left(\nabla_{\vec x}\Phi\right)-curl_{\vec x}\left(\vec
v\times\nabla_{\vec x}\Phi\right)+\left(\Delta_{\vec
x}\Phi\right)\vec v\right)=0,
\end{multline}
\begin{equation}\label{vhfffngghkjgghggtghjgfhjhjkghghDD}
\frac{\delta L}{\delta \Psi}=\frac{1}{4\pi}div_{\vec x}\vec
D-\rho=0,
\end{equation}
and
\begin{equation}\label{vhfffngghkjgghggtghjgfhjhjkghghyuiuuDD}
\frac{\delta L}{\delta \vec A}=\frac{1}{c}\vec j+\frac{1}{4\pi
c}\frac{\partial\vec D}{\partial t}-\frac{1}{4\pi}curl_{\vec x}\vec
B-\frac{1}{4\pi c}curl_{\vec x}\left(\vec v\times \vec
D\right)=\frac{1}{c}\vec j+\frac{1}{4\pi c}\frac{\partial\vec
D}{\partial t}-\frac{1}{4\pi}curl_{\vec x}\vec H=0.
\end{equation}
So
\begin{equation}\label{guigjgjffghguygjyfDD}
\begin{cases}
curl_{\vec x}\vec H=\frac{4\pi}{c}\vec
j+\frac{1}{c}\frac{\partial\vec D}{\partial
t}\\
div_{\vec x}\vec D=4\pi\rho\\
curl_{\vec x}\vec E+\frac{1}{c}\frac{\partial\vec B}{\partial t}=0\\
div_{\vec x}\vec B=0\\
\vec E=\vec D-\frac{1}{c}\vec v\times\vec B\\
\vec H=\vec B+\frac{1}{c}\vec v\times\vec D\\
curl_{\vec x}\left(curl_{\vec x}\vec v\right)=0
\\
\frac{\partial}{\partial t}\left\{div_{\vec x}\vec v\right\}+\vec
v\cdot\nabla_{\vec x}\left(div_{\vec x}\vec
v\right)+\frac{1}{4}\left|d_{\vec x}\vec v+\left\{d_{\vec x}\vec
v\right\}^T\right|^2=-\Delta_{\vec x}\Phi\\
\left(\mu\vec u-\mu\vec v+\frac{1}{4\pi c}\vec D\times\vec
B\right)=curl_{\vec x}\left(curl_{\vec x}\vec p\right)-\frac{1}{4\pi
G}\left(\frac{\partial}{\partial t}\left(\nabla_{\vec
x}\Phi\right)-curl_{\vec x}\left(\vec v\times\nabla_{\vec
x}\Phi\right)+\left(\Delta_{\vec x}\Phi\right)\vec v\right).
\end{cases}
\end{equation}
In particular, using continuum equation $\partial_t\mu+div_{\vec
x}\left(\mu\vec u\right)=0$ from the last equality in
\er{guigjgjffghguygjyfDD} we deduce
\begin{equation*}
\frac{\partial}{\partial t}\left(\frac{1}{4\pi G}\Delta_{\vec
x}\Phi-\mu\right)+div_{\vec x}\left\{\left(\frac{1}{4\pi
G}\Delta_{\vec x}\Phi-\mu\right)\vec v\right\}=-div_{\vec
x}\left\{\frac{1}{4\pi c}\vec D\times\vec B\right\}.
\end{equation*}
Thus denoting $Q=\Delta_{\vec x}\Phi/4\pi G-\mu$ we deduce
\begin{equation}\label{guigjgjffghguygjyfghggDD}
\begin{cases}
curl_{\vec x}\vec H=\frac{4\pi}{c}\vec
j+\frac{1}{c}\frac{\partial\vec D}{\partial
t}\\
div_{\vec x}\vec D=4\pi\rho\\
curl_{\vec x}\vec E+\frac{1}{c}\frac{\partial\vec B}{\partial t}=0\\
div_{\vec x}\vec B=0\\
\vec E=\vec D-\frac{1}{c}\vec v\times\vec B\\
\vec H=\vec B+\frac{1}{c}\vec v\times\vec D\\
curl_{\vec x}\left(curl_{\vec x}\vec v\right)=0
\\
\frac{\partial}{\partial t}\left\{div_{\vec x}\vec v\right\}+\vec
v\cdot\nabla_{\vec x}\left(div_{\vec x}\vec
v\right)+\frac{1}{4}\left|d_{\vec x}\vec v+\left\{d_{\vec x}\vec
v\right\}^T\right|^2=-4\pi G(\mu+Q)\\
\frac{\partial Q}{\partial t}+div_{\vec x}\left(Q\vec
v\right)=-div_{\vec x}\left\{\frac{1}{4\pi c}\vec D\times\vec
B\right\}.
\end{cases}
\end{equation}

\subsection{Lagrangian of the unified
Gravitational-Electromagnetic field in the case of some possible
alternative model of the gravity}\label{ghjfhgfdhdnw22mm} Consider
$\vec k$ to be the vectorial potential of the inertia, which is a
generally trivial speed-like vector field, assumed to be fixed in
every fixed inertial or non-inertial cartesian coordinate system
(see Definition \ref{jjhgyftdtd}). Given known the distribution of
inertial mass density of some continuum medium $\mu:=\mu(\vec x,t)$,
the field of velocities of this medium $\vec u:=\vec u(\vec x,t)$,
the charge density $\rho:=\rho(\vec x,t)$ and the current density
$\vec j:=\vec j(\vec x,t)$ consider a Lagrangian density $L$ defined
by
\begin{multline}\label{vhfffngghkjgghDDnw22mm}
L\left(\vec A,\Psi,\vec v,\Phi_0,\vec h,\vec x,t\right):=\\
\frac{1}{8\pi}\left|-\nabla_{\vec
x}\Psi-\frac{1}{c}\frac{\partial\vec A}{\partial t}+\frac{1}{c}\vec
v\times curl_{\vec x}\vec A\right|^2-\frac{1}{8\pi}\left|curl_{\vec
x}\vec A\right|^2-\left(\rho\Psi-\frac{1}{c}\vec A\cdot\vec
j\right)+\frac{\mu}{2}\left|\vec u-\vec v\right|^2
\\
-\frac{c^2}{8\pi G}\left|-\nabla_{\vec x}\Phi_0
-\frac{1}{c}\frac{\partial\vec h}{\partial t}
+\frac{1}{c}\vec v\times curl_{\vec x}\vec h-\frac{1}{c}\nabla_{\vec
x}\left(\vec h\cdot\vec v\right)\right|^2+\frac{c^2}{8\pi
G}\left|curl_{\vec x}\vec h\right|^2\\+\frac{c^2\beta}{4\pi
G}\left(\frac{1}{4}\left|d_{\vec x}\vec v+\left\{d_{\vec x}\vec
v\right\}^T\right|^2-\left|\Div_{\vec x}\vec v\right|^2\right) ,
\end{multline}
where
\begin{equation}\label{btfffygtgyggyDDnw22jkhk22mm}
\vec h=\vec v-\vec
k\quad\text{and}\quad\Phi_0=-\frac{1}{2c}\left|\vec v-\vec
k\right|^2,
\end{equation}
and $\beta\in \mathbb{R}$ is some constant. In other words we have:
\begin{multline}\label{vhfffngghkjgghDDnw22jkjkjkjk22gughhg22mm}
L\left(\vec A,\Psi,\vec v,\Phi_0,\vec h,\vec
x,t\right)=L_1\left(\vec A,\Psi,\vec v,\vec
x,t\right):=\\
\frac{1}{8\pi}\left|-\nabla_{\vec
x}\Psi-\frac{1}{c}\frac{\partial\vec A}{\partial t}+\frac{1}{c}\vec
v\times curl_{\vec x}\vec A\right|^2-\frac{1}{8\pi}\left|curl_{\vec
x}\vec A\right|^2-\left(\rho\Psi-\frac{1}{c}\vec A\cdot\vec
j\right)+\frac{\mu}{2}\left|\vec u-\vec v\right|^2
\\
-\frac{c^2}{8\pi G}\left|
\frac{1}{c}\nabla_{\vec x}\left(\frac{1}{2}\left|\vec v-\vec
k\right|^2\right)-\frac{1}{c}\frac{\partial}{\partial t}\left(\vec
v-\vec k\right)
+\frac{1}{c}\vec v\times curl_{\vec x}\left(\vec v-\vec
k\right)-\frac{1}{c}\nabla_{\vec x}\left(\left(\vec v-\vec
k\right)\cdot\vec v\right)\right|^2\\+\frac{c^2}{8\pi
G}\left|curl_{\vec x}\left(\vec v-\vec
k\right)\right|^2+\frac{c^2\beta}{4\pi
G}\left(\frac{1}{4}\left|d_{\vec x}\vec v+\left\{d_{\vec x}\vec
v\right\}^T\right|^2-\left|\Div_{\vec x}\vec v\right|^2\right).
\end{multline}
In particular, in the inertial coordinate system where $d_{\vec
x}\vec k=0$ and $\partial_t\vec k=0$ we have:
\begin{multline}\label{vhfffngghkjgghDDnw22jkjkjkjk22gughhg22hkhjhjhjh22kljlkjk22mma}
L\left(\vec A,\Psi,\vec v,\Phi_0,\vec h,\vec
x,t\right)=L_1\left(\vec A,\Psi,\vec v,\vec
x,t\right):=\\
\frac{1}{8\pi}\left|-\nabla_{\vec
x}\Psi-\frac{1}{c}\frac{\partial\vec A}{\partial t}+\frac{1}{c}\vec
v\times curl_{\vec x}\vec A\right|^2-\frac{1}{8\pi}\left|curl_{\vec
x}\vec A\right|^2-\left(\rho\Psi-\frac{1}{c}\vec A\cdot\vec
j\right)+\frac{\mu}{2}\left|\vec u-\vec v\right|^2
\\
-\frac{c^2}{8\pi G}\left|
-\frac{1}{c}\frac{\partial\vec v}{\partial t}
+\frac{1}{c}\vec v\times curl_{\vec x}\vec v-\frac{1}{c}\nabla_{\vec
x}\left(\frac{1}{2}\left|\vec
v\right|^2\right)\right|^2+\frac{c^2}{8\pi G}\left|curl_{\vec x}\vec
v\right|^2\\+\frac{c^2\beta}{4\pi G}\left(\frac{1}{4}\left|d_{\vec
x}\vec v+\left\{d_{\vec x}\vec v\right\}^T\right|^2-\left|\Div_{\vec
x}\vec v\right|^2\right),
\end{multline}
Note here the advantage of inertial coordinate systems, where $L$
and $L_1$ are complectly independent of the vectorial potential of
the inertia $\vec k$. Furthermore, by \er{apfrm9} we rewrite
\er{vhfffngghkjgghDDnw22jkjkjkjk22gughhg22hkhjhjhjh22kljlkjk22mma}
as:
\begin{multline}\label{vhfffngghkjgghDDnw22jkjkjkjk22gughhg22hkhjhjhjh22kljlkjk22mm}
L\left(\vec A,\Psi,\vec v,\Phi_0,\vec h,\vec
x,t\right)=L_1\left(\vec A,\Psi,\vec v,\vec
x,t\right):=\\
\frac{1}{8\pi}\left|-\nabla_{\vec
x}\Psi-\frac{1}{c}\frac{\partial\vec A}{\partial t}+\frac{1}{c}\vec
v\times curl_{\vec x}\vec A\right|^2-\frac{1}{8\pi}\left|curl_{\vec
x}\vec A\right|^2-\left(\rho\Psi-\frac{1}{c}\vec A\cdot\vec
j\right)+\frac{\mu}{2}\left|\vec u-\vec v\right|^2
\\
-\frac{c^2}{8\pi G}\left|
-\frac{1}{c}\left(\frac{\partial\vec v}{\partial t}+d_{\vec x}\vec
v\cdot\vec v\right)\right|^2+\frac{c^2}{8\pi G}\left|curl_{\vec
x}\vec v\right|^2+\frac{c^2\beta}{4\pi
G}\left(\frac{1}{4}\left|d_{\vec x}\vec v+\left\{d_{\vec x}\vec
v\right\}^T\right|^2-\left|\Div_{\vec x}\vec v\right|^2\right).
\end{multline}
Then, using Proposition \ref{yghgjtgyrtrt} by
\er{vhfffngghkjgghDDnw22mm}, \er{btfffygtgyggyDDnw22jkhk22mm} and
\er{vhfffngghkjgghDDnw22jkjkjkjk22gughhg22mm} we deduce that $L$ and
$L_1$ are invariant under the change of inertial or non-inertial
cartesian coordinate system, provided that $\vec h$ and $\vec A$ are
proper vector fields, $\vec v$ is a speed-like vector field and
$\Phi_0$ and $\Psi_0:=\Psi-\frac{1}{c}\vec A\cdot\vec v$ are proper
scalar fields.

We investigate critical points of the functional
\begin{equation}\label{btfffygtgyggyDDnw22mm}
J=\int_0^T\int_{\mathbb{R}^3}L_1\left(\vec A,\Psi,\vec v,\vec
x,t\right)d\vec x dt.
\end{equation}
We denote
\begin{equation}\label{guigjgjffghDDnw22mm}
\begin{cases}
\Psi_0=\Psi-\frac{1}{c}\vec A\cdot\vec v\\
\vec D=-\nabla_{\vec x}\Psi-\frac{1}{c}\frac{\partial\vec
A}{\partial t}+\frac{1}{c}\vec v\times curl_{\vec x}\vec
A=-\nabla_{\vec x}\Psi_0-\frac{1}{c}\frac{\partial\vec A}{\partial
t}+\frac{1}{c}\vec
v\times curl_{\vec x}\vec A-\frac{1}{c}\nabla_{\vec x}\left(\vec A\cdot\vec v\right)\\
\vec B=curl_{\vec x}\vec A
\\
\vec E=-\nabla_{\vec x}\Psi-\frac{1}{c}\frac{\partial\vec A}{\partial t}=\vec D-\frac{1}{c}\vec v\times\vec B\\
\vec H=curl_{\vec x}\vec A+\frac{1}{c}\vec
v\times\left(-\nabla_{\vec x}\Psi-\frac{1}{c}\frac{\partial\vec
A}{\partial t}+\frac{1}{c}\vec v\times curl_{\vec x}\vec
A\right)=\vec B+\frac{1}{c}\vec v\times\vec D.
\end{cases}
\end{equation}
and
\begin{equation}\label{guigjgjffghDDnwjkhhkhkhjklh22mm}
\begin{cases}
\vec R=-\nabla_{\vec x}\Phi_0-\frac{1}{c}\frac{\partial\vec
h}{\partial t}
+\frac{1}{c}\vec v\times curl_{\vec x}\vec
h-\frac{1}{c}\nabla_{\vec x}\left(\vec h\cdot\vec v\right)\\
\vec Q=curl_{\vec x}\vec h,
\end{cases}
\end{equation}
where $\vec h$ is a proper vector field and $\Phi_0$ is a proper
scalar field that are given by \er{btfffygtgyggyDDnw22jkhk22mm}. In
other words,
\begin{equation}\label{guigjgjffghDDnwjkhhkhkhjklh22ljjljkhk22mm}
\begin{cases}
\vec R=\frac{1}{c}\nabla_{\vec x}\left(\frac{1}{2}\left|\vec v-\vec
k\right|^2\right)-\frac{1}{c}\frac{\partial}{\partial t}\left(\vec
v-\vec k\right)
+\frac{1}{c}\vec v\times curl_{\vec x}\left(\vec v-\vec
k\right)-\frac{1}{c}\nabla_{\vec x}\left(\left(\vec v-\vec
k\right)\cdot\vec v\right)\\
\vec Q=curl_{\vec x}(\vec v-\vec k),
\end{cases}
\end{equation}
and in inertial coordinate system where $d_{\vec x}\vec k=0$ and
$\partial_t\vec k=0$ we also have:
\begin{equation}\label{guigjgjffghDDnwjkhhkhkhjklh22ljjljkhk22hhmm}
\begin{cases}
\vec R=-\frac{1}{c}\left(\frac{\partial\vec v}{\partial t}+d_{\vec
x}\vec
v\cdot\vec v\right)\\
\vec Q=curl_{\vec x}\vec v.
\end{cases}
\end{equation}
As before, by \er{guigjgjffghDDnw22mm} and
\er{guigjgjffghDDnwjkhhkhkhjklh22mm} and Proposition
\ref{yghgjtgyrtrt} we infer that both $\vec D$, $\vec B$ and $\vec
R$, $\vec Q$ are proper vector fields. Next, by
\er{guigjgjffghDDnw22mm} we have:
\begin{equation}\label{guigjgjffghjhkkgDDnw22mmff}
\begin{cases}
curl_{\vec x}\vec D+\frac{1}{c}\frac{\partial\vec B}{\partial t}
-\frac{1}{c}\,curl_{\vec x}\left(\vec v\times\vec B\right)=curl_{\vec x}\vec E+\frac{1}{c}\frac{\partial\vec B}{\partial t}=0\\
div_{\vec x}\vec B=0,
\end{cases}
\end{equation}
and by \er{guigjgjffghDDnwjkhhkhkhjklh22mm} we have:
\begin{equation}\label{guigjgjffghjhkkgDDnw22mm}
\begin{cases}
curl_{\vec x}\vec R+\frac{1}{c}\frac{\partial\vec Q}{\partial t}-\frac{1}{c}\,curl_{\vec x}\left(\vec v\times\vec Q\right)=0\\
div_{\vec x}\vec Q=0.
\end{cases}
\end{equation}
Furthermore, by \er{guigjgjffghDDnwjkhhkhkhjklh22ljjljkhk22mm} and
\er{apfrm9} we deduce
\begin{multline}\label{guigjgjffghDDnwjkhhkhkhjklh22ljjljkhk22hkjhjhjh22mm}
\vec R=\frac{1}{c}\nabla_{\vec x}\left(\frac{1}{2}\left|\vec v-\vec
k\right|^2\right)-\frac{1}{c}\left\{d_{\vec x}\vec
v\right\}^T\cdot\left(\vec v-\vec
k\right)-\frac{1}{c}\frac{\partial}{\partial t}\left(\vec v-\vec
k\right)-\frac{1}{c}\left\{d_{\vec x}\left(\vec v-\vec
k\right)\right\}^T\cdot\vec v
+\frac{1}{c}\vec v\times curl_{\vec x}\left(\vec v-\vec k\right)\\=
-\frac{1}{c}\left\{d_{\vec x}\vec k\right\}^T\cdot\left(\vec v-\vec
k\right)-\frac{1}{c}\frac{\partial}{\partial t}\left(\vec v-\vec
k\right)-\frac{1}{c}d_{\vec x}\left(\vec v-\vec k\right)\cdot\vec v
\\
\quad\text{and}\quad \vec Q=curl_{\vec x}(\vec v-\vec k),
\end{multline}
and thus, since $d_{\vec x}\vec k+\{d_{\vec x}\vec k\}^T=0$,
$curl_{\vec x}(curl_{\vec x}\vec k)=0$ and $\Div_{\vec x}\vec k=0$
we infer
\begin{multline}\label{guigjgjffghDDnwjkhhkhkhjklh22ljjljkhk22hkjhjhjh22jljjkhjgh22hhjhmm}
\Div_{\vec x}\vec R=
-\frac{1}{c}\left(\vec v-\vec k\right)\cdot\Delta_{\vec x}\vec
k-\frac{1}{c}tr\left(\{d_{\vec x}\vec k\}^T\cdot d_{\vec
x}\left(\vec v-\vec
k\right)\right)-\frac{1}{c}\frac{\partial}{\partial
t}\left(\Div_{\vec x}\vec v\right)-\frac{1}{c}d_{\vec
x}\left(\Div_{\vec x}\vec v\right)\cdot\vec
v\\-\frac{1}{c}tr\left(d_{\vec x}\left(\vec v-\vec
k\right)\cdot
d_{\vec x}\vec v
\right)\\
=-\frac{1}{c}\frac{\partial}{\partial t}\left(\Div_{\vec x}\vec
v\right)-\frac{1}{c}d_{\vec x}\left(\Div_{\vec x}\vec
v\right)\cdot\vec v-\frac{1}{c}tr\left(d_{\vec x}\left(\vec v-\vec
k\right)\cdot
d_{\vec x}(\vec v-\vec k)
\right)\\
=-\frac{1}{c}\frac{\partial}{\partial t}\left(\Div_{\vec x}\vec
v\right)-\frac{1}{c}d_{\vec x}\left(\Div_{\vec x}\vec
v\right)\cdot\vec v-\frac{1}{4c}\left|d_{\vec x}(\vec v-\vec
k)+\left\{d_{\vec x}(\vec v-\vec
k)\right\}^T\right|^2+\frac{1}{4c}\left|d_{\vec x}(\vec v-\vec
k)-\left\{d_{\vec x}(\vec v-\vec k)\right\}^T\right|^2
\\
=-\frac{1}{c}\frac{\partial}{\partial t}\left(\Div_{\vec x}\vec
v\right)-\frac{1}{c}d_{\vec x}\left(\Div_{\vec x}\vec
v\right)\cdot\vec v-\frac{1}{4c}\left|d_{\vec x}\vec
v+\left\{d_{\vec x}\vec v\right\}^T\right|^2+\frac{1}{2c}\left|\vec
Q\right|^2
\\
\quad\text{and}\quad curl_{\vec x}\vec Q=curl_{\vec x}(curl_{\vec
x}\vec v).
\end{multline}
So,
\begin{multline}\label{guigjgjffghDDnwjkhhkhkhjklh22ljjljkhk22hkjhjhjh22jljjkhjgh22hhjhjijhjhjtmm}
\frac{1}{c}\left(\frac{\partial}{\partial t}\left(\Div_{\vec x}\vec
v\right)+d_{\vec x}\left(\Div_{\vec x}\vec v\right)\cdot\vec
v+\frac{1}{4}\left|d_{\vec x}\vec v+\left\{d_{\vec x}\vec
v\right\}^T\right|^2\right)=\frac{1}{2c}\left|\vec
Q\right|^2-\Div_{\vec x}\vec R
\\
\quad\text{and}\quad curl_{\vec x}(curl_{\vec x}\vec v)=curl_{\vec
x}\vec Q,
\end{multline}
In other words,
\begin{multline}\label{guigjgjffghDDnwjkhhkhkhjklh22ljjljkhk22hkjhjhjh22jljjkhjgh22hhjhjijhjhjtjljjmm}
\frac{1}{c}\left(\frac{\partial}{\partial t}\left(\Div_{\vec x}\vec
v\right)+\Div_{\vec x}\left\{\left(\Div_{\vec x}\vec v\right)\vec
v\right\}+\frac{1}{4}\left|d_{\vec x}\vec v+\left\{d_{\vec x}\vec
v\right\}^T\right|^2-\left|\Div_{\vec x}\vec
v\right|^2\right)=\frac{1}{2c}\left|\vec Q\right|^2-\Div_{\vec
x}\vec R
\\
\quad\text{and}\quad curl_{\vec x}(curl_{\vec x}\vec v)=curl_{\vec
x}\vec Q.
\end{multline}
Moreover, by \er{vhfffngghkjgghDDnw22mm}, and  \er{apfrm3}
we have
\begin{equation}\label{vhfffngghkjgghggtghjgfhjyuiutDDnw22mm}
\frac{\delta L}{\delta \vec h}=
\frac{c^2}{4\pi G}curl_{\vec x}\vec Q
-\frac{c}{4\pi G}\left(\frac{\partial\vec R}{\partial t}-curl_{\vec
x}\left\{\vec v\times\vec R\right\}+\left(div_{\vec x}\vec
R\right)\vec v\right),
\end{equation}
\begin{equation}\label{vhfffngghkjgghggtghjgfhjoyuiyuDDnw22mmhkkhjhmm}
\frac{\delta L}{\delta{\Phi_0}}=-\frac{c^2}{4\pi G}\left(div_{\vec
x}\vec R\right).
\end{equation}
and
\begin{equation}\label{vhfffngghkjgghggtghjgfhjoyuiyuDDnw22mm}
\frac{\delta L}{\delta \vec v}=-\left(\mu\vec u-\mu\vec
v+\frac{1}{4\pi c}\vec D\times\vec B-\frac{c}{4\pi G}\vec
R\times\vec Q\right)+\frac{c^2\beta}{4\pi G}\,curl_{\vec
x}(curl_{\vec x}\vec v)-\frac{c}{4\pi G}\left(div_{\vec x}\vec
R\right)\vec h.
\end{equation}
Thus, since by \er{btfffygtgyggyDDnw22jkhk22mm} we have
\begin{equation}\label{jkhlgytguoiyhioyhmm}
L_1\left(\vec A,\Psi,\vec v,\vec x,t\right)=L\left(\vec A,\Psi,\vec
v,-\frac{1}{2c}\left|\vec v-\vec k\right|^2,(\vec v-\vec k),\vec
x,t\right)=L\left(\vec A,\Psi,\vec v,\Phi_0,\vec h,\vec x,t\right),
\end{equation}
by Chain rule we have
\begin{equation}\label{jkhlgytguoiyhioyh11mm}
\frac{\delta L_1}{\delta \vec v}\left(\vec A,\Psi,\vec v,\vec
x,t\right)=
\frac{\delta L}{\delta \vec v}+\frac{\delta L}{\delta \vec
h}-\frac{1}{c}\frac{\delta L}{\delta {\Phi_0}}\left(\vec v-\vec
k\right)
\end{equation}
Therefore, using \er{vhfffngghkjgghggtghjgfhjyuiutDDnw22mm},
\er{vhfffngghkjgghggtghjgfhjoyuiyuDDnw22mmhkkhjhmm}
\er{vhfffngghkjgghggtghjgfhjoyuiyuDDnw22mm} in
\er{jkhlgytguoiyhioyh11mm} we deduce:
\begin{multline}\label{vhfffngghkjgghggtghjgfhjyuiutDDnw22kllkjkllkjl22klljjkjk22mm}
\left(\mu\vec u-\mu\vec v+\frac{1}{4\pi c}\vec D\times\vec
B-\frac{c}{4\pi G}\vec R\times\vec Q\right)=\\ \frac{c^2\beta}{4\pi
G}\,curl_{\vec x}(curl_{\vec x}\vec v)+\frac{c^2}{4\pi G}curl_{\vec
x}\vec Q
-\frac{c}{4\pi G}\left(\frac{\partial\vec R}{\partial t}-curl_{\vec
x}\left\{\vec v\times\vec R\right\}+\left(div_{\vec x}\vec
R\right)\vec v\right).
\end{multline}
Moreover,
\begin{equation}\label{vhfffngghkjgghggtghjgfhjhjkghghDDnw22mm}
\frac{\delta L}{\delta \Psi}=\frac{1}{4\pi}div_{\vec x}\vec
D-\rho=0,
\end{equation}
and
\begin{equation}\label{vhfffngghkjgghggtghjgfhjhjkghghyuiuuDDnw22mm}
\frac{\delta L}{\delta \vec A}=\frac{1}{c}\vec j+\frac{1}{4\pi
c}\frac{\partial\vec D}{\partial t}-\frac{1}{4\pi}curl_{\vec x}\vec
B-\frac{1}{4\pi c}curl_{\vec x}\left(\vec v\times \vec
D\right)=\frac{1}{c}\vec j+\frac{1}{4\pi c}\frac{\partial\vec
D}{\partial t}-\frac{1}{4\pi}curl_{\vec x}\vec H=0.
\end{equation}
So
\begin{equation}\label{guigjgjffghguygjyfDDnw22mmd}
\begin{cases}
curl_{\vec x}\vec B=\frac{4\pi}{c}\vec
j+\frac{1}{c}\frac{\partial\vec D}{\partial t}-\frac{1}{c}curl_{\vec
x}\left(\vec v\times \vec D\right)
\\
div_{\vec x}\vec D=4\pi\rho
\\
curl_{\vec x}\vec D+\frac{1}{c}\frac{\partial\vec B}{\partial t}
-\frac{1}{c}\,curl_{\vec x}\left(\vec v\times\vec B\right)=0\\
div_{\vec x}\vec B=0\\
curl_{\vec x}\vec R+\frac{1}{c}\frac{\partial\vec Q}{\partial t}-\frac{1}{c}curl_{\vec x}\left(\vec v\times\vec Q\right)=0\\
div_{\vec x}\vec Q=0
\\
\frac{1}{c}\left(\frac{\partial}{\partial t}\left(\Div_{\vec x}\vec
v\right)+\vec v\cdot\nabla_{\vec x}\left(\Div_{\vec x}\vec
v\right)+\frac{1}{4}\left|d_{\vec x}\vec v+\left\{d_{\vec x}\vec
v\right\}^T\right|^2\right)=\frac{1}{2c}\left|\vec
Q\right|^2-\Div_{\vec x}\vec R
\\
curl_{\vec x}(curl_{\vec x}\vec v)=curl_{\vec x}\vec Q
\\
\frac{4\pi G}{c^2}\left(\mu\vec u-\mu\vec v+\frac{1}{4\pi c}\vec
D\times\vec
B-\frac{c}{4\pi G}\vec R\times\vec Q\right)=\\
(1+\beta)\,curl_{\vec x}\vec Q
-\frac{1}{c}\left(\frac{\partial\vec R}{\partial t}-curl_{\vec
x}\left\{\vec v\times\vec R\right\}+\left(div_{\vec x}\vec
R\right)\vec v\right),
\end{cases}
\end{equation}
and by \er{guigjgjffghDDnwjkhhkhkhjklh22ljjljkhk22hhmm} in the
inertial frame we have:
\begin{equation}\label{guigjgjffghguygjyfDDnw22khhjhgg22mmd}
\begin{cases}
curl_{\vec x}\vec B=\frac{4\pi}{c}\vec
j+\frac{1}{c}\frac{\partial\vec D}{\partial t}-\frac{1}{c}curl_{\vec
x}\left(\vec v\times \vec
D\right)\\
div_{\vec x}\vec D=4\pi\rho\\
curl_{\vec x}\vec D+\frac{1}{c}\frac{\partial\vec B}{\partial t}
-\frac{1}{c}\,curl_{\vec x}\left(\vec v\times\vec B\right)=0\\
div_{\vec x}\vec B=0\\
\vec R=-\frac{1}{c}\left(\frac{\partial\vec v}{\partial t}+d_{\vec
x}\vec
v\cdot\vec v\right)\\
\vec Q=curl_{\vec x}\vec v
\\
\frac{4\pi G}{c^2}\left(\mu\vec u-\mu\vec v+\frac{1}{4\pi c}\vec
D\times\vec
B-\frac{c}{4\pi G}\vec R\times\vec Q\right)=\\
(1+\beta)\,curl_{\vec x}\vec Q
-\frac{1}{c}\left(\frac{\partial\vec R}{\partial t}-curl_{\vec
x}\left\{\vec v\times\vec R\right\}+\left(div_{\vec x}\vec
R\right)\vec v\right).
\end{cases}
\end{equation}
Furthermore,  taking $\Div_{\vec x}$ of the both sides of the last
equality in \er{guigjgjffghguygjyfDDnw22mmd} and using continuum
equation $\partial_t\mu+\Div_{\vec x}\left(\mu\vec u\right)=0$ we
deduce
\begin{multline}\label{guigjgjffghguygjyfDDnw22mmz}
-\left(\partial_t\mu+\Div_{\vec x}\left(\mu\vec
v\right)\right)+\Div_{\vec x}\left\{\frac{1}{4\pi c}\vec D\times\vec
B-\frac{c}{4\pi  G}\vec R\times\vec Q\right\}=\\ \Div_{\vec
x}\left(\mu\vec u-\mu\vec v+\frac{1}{4\pi c}\vec D\times\vec
B-\frac{c}{4\pi  G}\vec R\times\vec Q\right)=\\
-\frac{c}{4\pi  G}\left(\frac{\partial}{\partial t}\left(\Div_{\vec
x}\vec R\right)+\Div_{\vec x}\left\{\left(div_{\vec x}\vec
R\right)\vec v\right\}\right),
\end{multline}
Therefore, considering the proper scalar quantity $Q_0$, that we
call the field mass, which satisfies
\begin{equation}\label{jtgiktiutiurtfirfmmhfffmm}
Q_0:=-\mu+\frac{c}{4\pi  G}\,\Div_{\vec x}\vec R,
\end{equation}
by \er{guigjgjffghguygjyfDDnw22mmz} we deduce
\begin{equation}\label{jtgiktiutiurtfirfmm}
\frac{\partial Q_0}{\partial t}+div_{\vec x}\left(Q_0\vec
v\right)=-\Div_{\vec x}\left\{\frac{1}{4\pi c}\vec D\times\vec
B-\frac{c}{4\pi  G}\vec R\times\vec Q\right\}.
\end{equation}
Thus, we rewrite \er{guigjgjffghguygjyfDDnw22mmd} as:
\begin{equation}\label{guigjgjffghguygjyfDDnw22mm}
\begin{cases}
curl_{\vec x}\vec B=\frac{4\pi}{c}\left(\vec j-\rho\vec
v\right)+\frac{1}{c}\frac{\partial\vec D}{\partial
t}-\frac{1}{c}curl_{\vec x}\left(\vec v\times \vec
D\right)+\frac{1}{c}\left(div_{\vec x}\vec D\right)\vec v,
\\
div_{\vec x}\vec D=4\pi\rho,
\\
curl_{\vec x}\vec D+\frac{1}{c}\frac{\partial\vec B}{\partial t}
-\frac{1}{c}\,curl_{\vec x}\left(\vec v\times\vec B\right)=0,\\
div_{\vec x}\vec B=0,\\
curl_{\vec x}\vec R+\frac{1}{c}\frac{\partial\vec Q}{\partial t}-\frac{1}{c}curl_{\vec x}\left(\vec v\times\vec Q\right)=0,\\
div_{\vec x}\vec Q=0,
\\
\frac{4\pi G}{c^2}\left(\mu\vec u-\mu\vec v+\frac{1}{4\pi c}\vec
D\times\vec
B-\frac{c}{4\pi G}\vec R\times\vec Q\right)=\\
(1+\beta)\,curl_{\vec x}\vec Q
-\frac{1}{c}\left(\frac{\partial\vec R}{\partial t}-curl_{\vec
x}\left\{\vec v\times\vec R\right\}+\left(div_{\vec x}\vec
R\right)\vec v\right),\\
\Div_{\vec x}\vec R=\frac{4\pi G}{c}(\mu+Q_0),
\\
\frac{1}{c}\left(\frac{\partial}{\partial t}\left(\Div_{\vec x}\vec
v\right)+\vec v\cdot\nabla_{\vec x}\left(\Div_{\vec x}\vec
v\right)+\frac{1}{4}\left|d_{\vec x}\vec v+\left\{d_{\vec x}\vec
v\right\}^T\right|^2\right)=\frac{1}{2c}\left|\vec
Q\right|^2-\Div_{\vec x}\vec R,
\\
curl_{\vec x}(curl_{\vec x}\vec v)=curl_{\vec x}\vec Q,
\\
\frac{\partial Q_0}{\partial t}+div_{\vec x}\left(Q_0\vec
v\right)=-\Div_{\vec x}\left\{\frac{1}{4\pi c}\vec D\times\vec
B-\frac{c}{4\pi  G}\vec R\times\vec Q\right\},
\end{cases}
\end{equation}
and we rewrite \er{guigjgjffghguygjyfDDnw22khhjhgg22mmd} in the
inertial frame as:
\begin{equation}\label{guigjgjffghguygjyfDDnw22khhjhgg22mm}
\begin{cases}
curl_{\vec x}\vec B=\frac{4\pi}{c}\vec
j+\frac{1}{c}\frac{\partial\vec D}{\partial t}-\frac{1}{c}curl_{\vec
x}\left(\vec v\times \vec
D\right),\\
div_{\vec x}\vec D=4\pi\rho,\\
curl_{\vec x}\vec D+\frac{1}{c}\frac{\partial\vec B}{\partial t}
-\frac{1}{c}\,curl_{\vec x}\left(\vec v\times\vec B\right)=0,\\
div_{\vec x}\vec B=0,\\
\vec R=-\frac{1}{c}\left(\frac{\partial\vec v}{\partial t}+d_{\vec
x}\vec
v\cdot\vec v\right),\\
\vec Q=curl_{\vec x}\vec v,
\\
\frac{4\pi G}{c^2}\left(\mu\vec u-\mu\vec v+\frac{1}{4\pi c}\vec
D\times\vec
B-\frac{c}{4\pi G}\vec R\times\vec Q\right)=\\
(1+\beta)\,curl_{\vec x}\vec Q
-\frac{1}{c}\left(\frac{\partial\vec R}{\partial t}-curl_{\vec
x}\left\{\vec v\times\vec R\right\}+\left(div_{\vec x}\vec
R\right)\vec v\right),\\
\Div_{\vec x}\vec R=\frac{4\pi G}{c} (\mu+Q_0),\\
\frac{\partial Q_0}{\partial t}+div_{\vec x}\left(Q_0\vec
v\right)=-\Div_{\vec x}\left\{\frac{1}{4\pi c}\vec D\times\vec
B-\frac{c}{4\pi  G}\vec R\times\vec Q\right\}.
\end{cases}
\end{equation}
As before, by Proposition \ref{yghgjtgyrtrt} we deduce that
\er{guigjgjffghguygjyfDDnw22mm} is invariant under the change of
inertial or non-inertial cartesian coordinate systems. Moreover,
\er{guigjgjffghguygjyfDDnw22khhjhgg22mm} is invariant under the
change of inertial cartesian coordinate systems. Note also that,
both, \er{guigjgjffghguygjyfDDnw22mm} in an arbitrary inertial or
non-inertial cartesian coordinate system and
\er{guigjgjffghguygjyfDDnw22khhjhgg22mm} in an arbitrary inertial
cartesian coordinate system, are complectly independent of the
vectorial potential of the inertia $\vec k$.

Finally, note that in the case of large constant $|\beta|\gg 1$ we
have $\vec Q\to 0$ in \er{guigjgjffghguygjyfDDnw22mm} and thus, the
gravity equations \er{guigjgjffghguygjyfDDnw22mm} reduce to the
equations of the Newtonian-type Gravity in the form of
\er{guigjgjffghguygjyfghggDD}. In that case the gravity field
propagates with the infinite speed. On the other hand, in the case
of vanishing constant $\beta=0$ the form of equations for $\vec R$
and $\vec Q$ in \er{guigjgjffghguygjyfDDnw22mm} is completely the
same as the form of the Maxwell equations for $\vec D$ and $\vec B$
in \er{guigjgjffghguygjyfDDnw22mm}, except of the different meaning
of "charges" and "currents" in these two sets of equations. In that
case the electromagnetic and the gravity fields propagate with the
same speed. However, in the mixed case of constant $\beta\sim 1$ the
electromagnetic and the gravity fields propagate with different
finite speeds.

\begin{remark}\label{gygygygyggjh34}
One can wonder: what should be possible values of the vectorial
gravitational potential $\vec v$ in the proximity of the Earth or
another massive body? In remark \ref{gygygygyggjh} we answered this
question in the case of the Newtonian-type gravity , given by
\er{MaxVacFull1ninshtrgravortghhghgjkgghklhjgkghghjjkjhjkkggjkhjkhjjhhfhjhklkhkhjjklzzzyyyNWBWHWPPN222ll}.
In order to answer this question in the case of more general laws of
the gravity, given by \er{guigjgjffghguygjyfDDnw22mm} with arbitrary
constant $\beta$, consider two cartesian coordinate systems:
\underline{non-rotating} system $\bf(*)$ with the center that
coincides with the center of masses of the Earth and
\underline{inertial} system $\bf(**)$ related to some external
cosmic bodies. Assume that the center of masses of the Earth has
place $\vec R(t')$ and velocity $\vec W(t'):=\frac{d\vec
R}{dt'}(t')$ in the coordinate system $\bf(**)$. Thus the change of
coordinate system $\bf(*)$ to coordinate system $\bf(**)$ is given
by
\begin{equation}\label{hjjgghghghyu34}
\begin{cases}
\vec x'=\vec x+\vec R(t),\\
t'=t,
\end{cases}
\end{equation}
and the vectorial gravitational potential $\vec v$, being a speed
like vector field, transforms as
\begin{equation}\label{hjjgghghghyulll34}
\vec v'=\vec v+\vec W(t).
\end{equation}
Next, since the system $\bf(**)$ is inertial, consistently with
\er{guigjgjffghguygjyfDDnw22khhjhgg22mm} in the system $\bf(**)$ we
have
\begin{equation}\label{guigjgjffghguygjyfDDnw22khhjhgg22mm34hhhhhh34}
\begin{cases}
curl_{\vec x'}\vec B'=\frac{4\pi}{c}\left(\vec j'-\rho'\vec
v'\right)+\frac{1}{c}\frac{\partial\vec D'}{\partial
t'}-\frac{1}{c}curl_{\vec x'}\left(\vec v'\times \vec
D'\right)+\frac{1}{c}\left(div_{\vec x'}\vec D'\right)\vec v',
\\
div_{\vec x}\vec D'=4\pi\rho',
\\
curl_{\vec x'}\vec D'+\frac{1}{c}\frac{\partial\vec B'}{\partial t'}
-\frac{1}{c}\,curl_{\vec x'}\left(\vec v'\times\vec B'\right)=0,\\
div_{\vec x'}\vec B'=0,\\
\vec R'=-\frac{1}{c}\left(\frac{\partial\vec v'}{\partial
t'}+d_{\vec x'}\vec
v'\cdot\vec v'\right),\\
\vec Q'=curl_{\vec x'}\vec v',
\\
curl_{\vec x'}\vec R'+\frac{1}{c}\frac{\partial\vec Q'}{\partial t'}-\frac{1}{c}curl_{\vec x'}\left(\vec v'\times\vec Q'\right)=0,\\
div_{\vec x'}\vec Q'=0,
\\
\frac{4\pi G}{c^2}\left(\mu'\vec u'-\mu'\vec v'+\frac{1}{4\pi c}\vec
D'\times\vec
B'-\frac{c}{4\pi G}\vec R'\times\vec Q'\right)=\\
(1+\beta)\,curl_{\vec x'}\vec Q'
-\frac{1}{c}\left(\frac{\partial\vec R'}{\partial t'}-curl_{\vec
x'}\left\{\vec v'\times\vec R'\right\}+\left(div_{\vec x}\vec
R'\right)\vec v'\right),\\
\Div_{\vec x'}\vec R'=\frac{4\pi G}{c}(\mu'+Q'_0),
\\
\frac{\partial Q'_0}{\partial t'}+div_{\vec x'}\left(Q'_0\vec
v'\right)=-\Div_{\vec x'}\left\{\frac{1}{4\pi c}\vec D'\times\vec
B'-\frac{c}{4\pi  G}\vec R'\times\vec Q'\right\},
\end{cases}
\end{equation}
and by \er{guigjgjffghguygjyfDDnw22mm} in the system $\bf(*)$ we
have
\begin{equation}\label{guigjgjffghguygjyfDDnw22mm3434}
\begin{cases}
curl_{\vec x}\vec B=\frac{4\pi}{c}\left(\vec j-\rho\vec
v\right)+\frac{1}{c}\frac{\partial\vec D}{\partial
t}-\frac{1}{c}curl_{\vec x}\left(\vec v\times \vec
D\right)+\frac{1}{c}\left(div_{\vec x}\vec D\right)\vec v,
\\
div_{\vec x}\vec D=4\pi\rho,
\\
curl_{\vec x}\vec D+\frac{1}{c}\frac{\partial\vec B}{\partial t}
-\frac{1}{c}\,curl_{\vec x}\left(\vec v\times\vec B\right)=0,\\
div_{\vec x}\vec B=0,\\
curl_{\vec x}\vec R+\frac{1}{c}\frac{\partial\vec Q}{\partial t}-\frac{1}{c}curl_{\vec x}\left(\vec v\times\vec Q\right)=0,\\
div_{\vec x}\vec Q=0,
\\
\frac{4\pi G}{c^2}\left(\mu\vec u-\mu\vec v+\frac{1}{4\pi c}\vec
D\times\vec
B-\frac{c}{4\pi G}\vec R\times\vec Q\right)=\\
(1+\beta)\,curl_{\vec x}\vec Q
-\frac{1}{c}\left(\frac{\partial\vec R}{\partial t}-curl_{\vec
x}\left\{\vec v\times\vec R\right\}+\left(div_{\vec x}\vec
R\right)\vec v\right),\\
\Div_{\vec x}\vec R=\frac{4\pi G}{c}(\mu+Q_0),
\\
\frac{1}{c}\left(\frac{\partial}{\partial t}\left(\Div_{\vec x}\vec
v\right)+\vec v\cdot\nabla_{\vec x}\left(\Div_{\vec x}\vec
v\right)+\frac{1}{4}\left|d_{\vec x}\vec v+\left\{d_{\vec x}\vec
v\right\}^T\right|^2\right)=\frac{1}{2c}\left|\vec
Q\right|^2-\Div_{\vec x}\vec R,
\\
curl_{\vec x}(curl_{\vec x}\vec v)=curl_{\vec x}\vec Q,
\\
\frac{\partial Q_0}{\partial t}+div_{\vec x}\left(Q_0\vec
v\right)=-\Div_{\vec x}\left\{\frac{1}{4\pi c}\vec D\times\vec
B-\frac{c}{4\pi  G}\vec R\times\vec Q\right\}.
\end{cases}
\end{equation}
On the other hand, by \er{hjjgghghghyu34} and \er{hjjgghghghyulll34}
and using Proposition \ref{yghgjtgyrtrt}
we deduce
\begin{equation}
\label{MaxVacFull1ninshtrgravortghhghgjkgghklhjgkghghjjkjhjkkggjkhjkhjjhhfhjhklkhkhjjklzzzyyyhjggjhgghhjhNWNWBWHWPPN222kkkgtytghjjh34344}
curl_{\vec x}\vec v=curl_{\vec x'}\vec v',
\end{equation}
and
\begin{equation}
\label{MaxVacFull1ninshtrgravortghhghgjkgghklhjgkghghjjkjhjkkggjkhjkhjjhhfhjhklkhkhjjklzzzyyyhjggjhgghhjhNWNWBWHWPPN222kkkgtytghjjhghhg34344}
\frac{d\vec W}{\partial t}(t)+\frac{\partial\vec v}{\partial
t}+d_{\vec x}\vec v\cdot\vec v=\frac{d\vec W}{\partial
t'}(t')+\frac{\partial\vec v}{\partial t'}+d_{\vec x'}\vec
v\cdot\vec W(t')+d_{\vec x'}\vec v\cdot\vec v= \frac{\partial\vec
v'}{\partial t'}+d_{\vec x'}\vec v'\cdot\vec v'.
\end{equation}
Thus since $\vec R$ is a proper field, by
\er{guigjgjffghguygjyfDDnw22khhjhgg22mm34hhhhhh34},
\er{MaxVacFull1ninshtrgravortghhghgjkgghklhjgkghghjjkjhjkkggjkhjkhjjhhfhjhklkhkhjjklzzzyyyhjggjhgghhjhNWNWBWHWPPN222kkkgtytghjjh34344}
and
\er{MaxVacFull1ninshtrgravortghhghgjkgghklhjgkghghjjkjhjkkggjkhjkhjjhhfhjhklkhkhjjklzzzyyyhjggjhgghhjhNWNWBWHWPPN222kkkgtytghjjhghhg34344}
we deduce
\begin{equation}
\label{MaxVacFull1ninshtrgravortghhghgjkgghklhjgkghghjjkjhjkkggjkhjkhjjhhfhjhklkhkhjjklzzzyyyhjggjhgghhjhNWNWBWHWPPN222kkkgtytghjjh3434454}
curl_{\vec x}\vec v=\vec Q'=\vec Q,
\end{equation}
and
\begin{equation}
\label{MaxVacFull1ninshtrgravortghhghgjkgghklhjgkghghjjkjhjkkggjkhjkhjjhhfhjhklkhkhjjklzzzyyyhjggjhgghhjhNWNWBWHWPPN222kkkgtytghjjhghhg3434454}
\frac{d\vec W}{\partial t}(t)+\frac{\partial\vec v}{\partial
t}+d_{\vec x}\vec v\cdot\vec v=-c\,\vec R'=-c\,\vec R.
\end{equation}
Therefore, by \er{guigjgjffghguygjyfDDnw22mm3434},
\er{MaxVacFull1ninshtrgravortghhghgjkgghklhjgkghghjjkjhjkkggjkhjkhjjhhfhjhklkhkhjjklzzzyyyhjggjhgghhjhNWNWBWHWPPN222kkkgtytghjjh3434454}
and
\er{MaxVacFull1ninshtrgravortghhghgjkgghklhjgkghghjjkjhjkkggjkhjkhjjhhfhjhklkhkhjjklzzzyyyhjggjhgghhjhNWNWBWHWPPN222kkkgtytghjjhghhg3434454}
in the system $\bf(*)$ we have
\begin{equation}\label{guigjgjffghguygjyfDDnw22khhjhgg22mmhkkhhjjjg}
\begin{cases}
curl_{\vec x}\vec B=\frac{4\pi}{c}\vec
j+\frac{1}{c}\frac{\partial\vec D}{\partial t}-\frac{1}{c}curl_{\vec
x}\left(\vec v\times \vec
D\right),\\
div_{\vec x}\vec D=4\pi\rho,\\
curl_{\vec x}\vec D+\frac{1}{c}\frac{\partial\vec B}{\partial t}
-\frac{1}{c}\,curl_{\vec x}\left(\vec v\times\vec B\right)=0,\\
div_{\vec x}\vec B=0,\\
\frac{1}{c}\frac{d\vec W}{\partial t}(t)+\vec
R=-\frac{1}{c}\left(\frac{\partial\vec v}{\partial t}+d_{\vec x}\vec
v\cdot\vec v\right),\\
\vec Q=curl_{\vec x}\vec v,
\\
\frac{4\pi G}{c^2}\left(\mu\vec u-\mu\vec v+\frac{1}{4\pi c}\vec
D\times\vec
B-\frac{c}{4\pi G}\vec R\times\vec Q\right)=\\
(1+\beta)\,curl_{\vec x}\vec Q
-\frac{1}{c}\left(\frac{\partial\vec R}{\partial t}-curl_{\vec
x}\left\{\vec v\times\vec R\right\}+\left(div_{\vec x}\vec
R\right)\vec v\right),\\
\Div_{\vec x}\vec R=\frac{4\pi G}{c} (\mu+Q_0),\\
\frac{\partial Q_0}{\partial t}+div_{\vec x}\left(Q_0\vec
v\right)=-\Div_{\vec x}\left\{\frac{1}{4\pi c}\vec D\times\vec
B-\frac{c}{4\pi  G}\vec R\times\vec Q\right\}.
\end{cases}
\end{equation}
On the other hand, since the system $\bf(**)$ is inertial, the
quantity $\frac{d\vec W}{\partial t}(t)$, being generated by the
gravitational field from the far bodies, is insignificant
with respect to the quantity $c\vec R$ in the scale compatible to
the Earth size. Thus, we rewrite
\er{guigjgjffghguygjyfDDnw22khhjhgg22mmhkkhhjjjg} as:
\begin{equation}\label{guigjgjffghguygjyfDDnw22khhjhgg22mmhkkhhjjjgioo}
\begin{cases}
curl_{\vec x}\vec B=\frac{4\pi}{c}\vec
j+\frac{1}{c}\frac{\partial\vec D}{\partial t}-\frac{1}{c}curl_{\vec
x}\left(\vec v\times \vec
D\right),\\
div_{\vec x}\vec D=4\pi\rho,\\
curl_{\vec x}\vec D+\frac{1}{c}\frac{\partial\vec B}{\partial t}
-\frac{1}{c}\,curl_{\vec x}\left(\vec v\times\vec B\right)=0,\\
div_{\vec x}\vec B=0,\\
\vec R=-\frac{1}{c}\left(\frac{\partial\vec v}{\partial t}+d_{\vec
x}\vec
v\cdot\vec v\right),\\
\vec Q=curl_{\vec x}\vec v,
\\
\frac{4\pi G}{c^2}\left(\mu\vec u-\mu\vec v+\frac{1}{4\pi c}\vec
D\times\vec
B-\frac{c}{4\pi G}\vec R\times\vec Q\right)=\\
(1+\beta)\,curl_{\vec x}\vec Q
-\frac{1}{c}\left(\frac{\partial\vec R}{\partial t}-curl_{\vec
x}\left\{\vec v\times\vec R\right\}+\left(div_{\vec x}\vec
R\right)\vec v\right),\\
\Div_{\vec x}\vec R=\frac{4\pi G}{c} (\mu+Q_0),\\
\frac{\partial Q_0}{\partial t}+div_{\vec x}\left(Q_0\vec
v\right)=-\Div_{\vec x}\left\{\frac{1}{4\pi c}\vec D\times\vec
B-\frac{c}{4\pi  G}\vec R\times\vec Q\right\}.
\end{cases}
\end{equation}
Next, we can neglect all the far cosmic body masses expect of the
Earth itself and thus we can consider
\begin{equation}
\label{MaxVacFull1ninshtrgravortghhghgjkgghklhjgkghghjjkjhjkkggjkhjkhjjhhfhjhklkhkhjjklzzzyyyhjggjhgghhjhNWNWBWHWPPN222kkkjhjhjjhjhjhjhhjhj3444jkkyipi}
\mu(\vec x,t)=\mu_1\left(|\vec x|\right)\quad\text{and}\quad\vec
u(\vec x,t)=0 \quad\quad\forall\,(\vec x,t),
\end{equation}
where $\mu_1:=\mu_1\left(|\vec x|\right)$ is the inertial mass
density of the Earth which is assumed to be a radial function such
that
\begin{equation}
\label{MaxVacFull1ninshtrgravortghhghgjkgghklhjgkghghjjkjhjkkggjkhjkhjjhhfhjhklkhkhjjklzzzyyyhjggjhgghhjhNWNWBWHWPPN222kkkjhjhjjhjhjhjhhjhj3444jkkyipioyiyi}
\mu_1\left(|\vec x|\right)=0\quad\quad\text{if}\quad |\vec x|>r_0,
\end{equation}
where $r_0$ is the Earth radius. Moreover, we can neglect all the
electromagnetical masses and thus we simplify the equations for the
Gravity in \er{guigjgjffghguygjyfDDnw22khhjhgg22mmhkkhhjjjgioo} as:
\begin{equation}\label{guigjgjffghguygjyfDDnw22khhjhgg22mmhkkhhjjjgioojkjkggjg}
\begin{cases}
\vec R=-\frac{1}{c}\left(\frac{\partial\vec v}{\partial t}+d_{\vec
x}\vec
v\cdot\vec v\right),\\
\vec Q=curl_{\vec x}\vec v,
\\
\frac{4\pi G}{c^2}\left(-\mu_1\left(|\vec x|\right)\vec v-\frac{c}{4\pi G}\vec R\times\vec Q\right)=\\
(1+\beta)\,curl_{\vec x}\vec Q
-\frac{1}{c}\left(\frac{\partial\vec R}{\partial t}-curl_{\vec
x}\left\{\vec v\times\vec R\right\}+\left(div_{\vec x}\vec
R\right)\vec v\right),\\
\Div_{\vec x}\vec R=\frac{4\pi G}{c} \left(\mu_1\left(|\vec x|\right)+Q_0\right),\\
\frac{\partial Q_0}{\partial t}+div_{\vec x}\left(Q_0\vec
v\right)=\Div_{\vec x}\left\{\frac{c}{4\pi  G}\vec R\times\vec
Q\right\}.
\end{cases}
\end{equation}
Being in the system  $\bf(*)$ which is stationary with respect to
the center of the Earth we look for stationary (i.e. time
independent) solutions of
\er{guigjgjffghguygjyfDDnw22khhjhgg22mmhkkhhjjjgioojkjkggjg}. Thus
\er{guigjgjffghguygjyfDDnw22khhjhgg22mmhkkhhjjjgioojkjkggjg}
implies:
\begin{equation}\label{guigjgjffghguygjyfDDnw22khhjhgg22mmhkkhhjjjgioojkjkggjghjggjhh}
\begin{cases}
\vec R=-\frac{1}{c}\,d_{\vec x}\vec
v\cdot\vec v,\\
\vec Q=curl_{\vec x}\vec v,
\\
\frac{4\pi G}{c^2}\left(-\mu_1\left(|\vec x|\right)\vec v-\frac{c}{4\pi G}\vec R\times\vec Q\right)=\\
(1+\beta)\,curl_{\vec x}\vec Q
-\frac{1}{c}\left(-curl_{\vec x}\left\{\vec v\times\vec
R\right\}+\left(div_{\vec x}\vec
R\right)\vec v\right),\\
\Div_{\vec x}\vec R=\frac{4\pi G}{c} \left(\mu_1\left(|\vec x|\right)+Q_0\right),\\
div_{\vec x}\left(Q_0\vec v\right)=\Div_{\vec x}\left\{\frac{c}{4\pi
G}\vec R\times\vec Q\right\}.
\end{cases}
\end{equation}
On the other hand, by the symmetry considerations of the problem we
look for the solution of
\er{guigjgjffghguygjyfDDnw22khhjhgg22mmhkkhhjjjgioojkjkggjghjggjhh}
that satisfies $\vec v(\vec x)=\nabla_{\vec x}Z_0\left(|\vec
x|\right)$ where, again by the symmetry of the problem, the scalar
function $Z_0\left(|\vec x|\right)$ should be radial. In particular,
by
\er{guigjgjffghguygjyfDDnw22khhjhgg22mmhkkhhjjjgioojkjkggjghjggjhh}
we obtain
\begin{equation}\label{guigjgjffghguygjyfDDnjggjgjgj}
\vec Q=curl_{\vec x}\vec v=0
\end{equation}
and thus we simplify
\er{guigjgjffghguygjyfDDnw22khhjhgg22mmhkkhhjjjgioojkjkggjghjggjhh}
as:
\begin{equation}\label{guigjgjffghguygjyfDDnw22khhjhgg22mmhkkhhjjjgioojkjkggjghjggjhhuiiyyiyijkjkhkkh}
\begin{cases}
\vec R=-\frac{1}{c}\,d_{\vec x}\vec v\cdot\vec
v=-\frac{1}{c}\,\nabla_{\vec x}\left(\frac{1}{2}\left|\vec
v\right|^2\right),\\
\vec Q=curl_{\vec x}\vec v=0,
\\
\frac{4\pi G}{c^2}\,\mu_1\left(|\vec x|\right)\vec v=
\frac{1}{c}\left(-curl_{\vec x}\left\{\vec v\times\vec
R\right\}+\left(div_{\vec x}\vec
R\right)\vec v\right),\\
\Div_{\vec x}\vec R=\frac{4\pi G}{c}\,\mu_1\left(|\vec x|\right),\\
curl_{\vec x}\vec R=0,\\
Q_0=0.
\end{cases}
\end{equation}
In particular, since $\vec v=\nabla_{\vec x}Z_0\left(|\vec
x|\right)$ and $\vec R=-\frac{1}{c}\,\nabla_{\vec
x}\left(\frac{1}{2}\left|\vec v\right|^2\right)$ are both gradients
of radial functions, we have $\vec v\times\vec R=0$ and thus, we
further simplify
\er{guigjgjffghguygjyfDDnw22khhjhgg22mmhkkhhjjjgioojkjkggjghjggjhhuiiyyiyijkjkhkkh}
as:
\begin{equation}\label{guigjgjffghguygjyfDDnw22khhjhgg22mmhkkhhjjjgioojkjkggjghjggjhhuiiyyiyijkjkhkkhuu}
\begin{cases}
\vec R=-\frac{1}{c}\,\nabla_{\vec x}\left(\frac{1}{2}\left|\vec
v\right|^2\right),\\
\vec Q=curl_{\vec x}\vec v=0,
\\
\Div_{\vec x}\vec R=\frac{4\pi G}{c}\,\mu_1\left(|\vec x|\right),\\
curl_{\vec x}\vec R=0,\\
Q_0=0.
\end{cases}
\end{equation}
However,
\er{guigjgjffghguygjyfDDnw22khhjhgg22mmhkkhhjjjgioojkjkggjghjggjhhuiiyyiyijkjkhkkhuu}
is equivalent to the following:
\begin{equation}\label{guigjgjffghguygjyfDDnw22khhjhgg22mmhkkhhjjjgioojkjkggjghjggjhhuiiyyiyijkjkhkkhuuhhhhhjgjikuo}
\begin{cases}
\Delta_{\vec x}\left(-\frac{1}{2}\left|\vec
v\right|^2\right)=4\pi G\,\mu_1\left(|\vec x|\right),\\
curl_{\vec x}\vec v=0,\\
\vec R=-\frac{1}{c}\,\nabla_{\vec x}\left(\frac{1}{2}\left|\vec
v\right|^2\right),\\
\vec Q=0,
\\
Q_0=0.
\end{cases}
\end{equation}
Therefore, denoting
\begin{equation}\label{gjfghdhgdfjfjfgukiufirfiufuy}
\Phi_1:=-\frac{1}{2}\left|\vec v\right|^2
\end{equation}
we rewrite
\er{guigjgjffghguygjyfDDnw22khhjhgg22mmhkkhhjjjgioojkjkggjghjggjhhuiiyyiyijkjkhkkhuuhhhhhjgjikuo}
as:
\begin{equation}\label{guigjgjffghguygjyfDDnw22khhjhgg22mmhkkhhjjjgioojkjkggjghjggjhhuiiyyiyijkjkhkkhuuhhhhhjgjikuojhjhjhjguyttty}
\begin{cases}
\Delta_{\vec x}\Phi_1=4\pi G\,\mu_1\left(|\vec x|\right),\\
\Phi_1:=-\frac{1}{2}\left|\vec v\right|^2,
\\
curl_{\vec x}\vec v=0,\\
\vec R=-\frac{1}{c}\,\nabla_{\vec x}\left(\frac{1}{2}\left|\vec
v\right|^2\right),\\
\vec Q=0,
\\
Q_0=0,
\end{cases}
\end{equation}
where the radial scalar field $\Phi_1:=\Phi_1\left(|\vec x|\right)$,
coincides with the usual Newtonian potential of the Earth, and
outside of the Earth surface it is known that $\Phi_1(\vec
x)=-\frac{Gm_0}{|\vec x|}$, where $m_0$ is the total mass of the
Earth (we have $m_0=\iiint_{|\vec x|\leq r_0}\mu_1\left(|\vec
x|\right)d\vec x$). Thus, since there exists a scalar radial field
$Z_0(\vec x)$ such that $\vec v(\vec x)=\nabla_{\vec
x}Z_0\left(|\vec x|\right)$
by
\er{guigjgjffghguygjyfDDnw22khhjhgg22mmhkkhhjjjgioojkjkggjghjggjhhuiiyyiyijkjkhkkhuuhhhhhjgjikuojhjhjhjguyttty}
we obtain
\begin{equation}
\label{MaxVacFull1ninshtrgravortghhghgjkgghklhjgkghghjjkjhjkkggjkhjkhjjhhfhjhklkhkhjjklzzzyyyhjggjhgghhjhNWNWBWHWPPN222kkkgtytghjjhpoppkkkhhhk34}
\left|\frac{dZ_0}{d(|\vec x|)}(|\vec x|)\right|= \sqrt{-2\Phi_1(\vec
x)},
\end{equation}
that implies either
\begin{equation}
\label{MaxVacFull1ninshtrgravortghhghgjkgghklhjgkghghjjkjhjkkggjkhjkhjjhhfhjhklkhkhjjklzzzyyyhjggjhgghhjhNWNWBWHWPPN222kkkgtytghjjhpoppkkkhhhkhjhj34}
\vec v(\vec x)= \frac{\sqrt{-2\Phi_1(|\vec x|)}}{|\vec x|}\vec x,
\end{equation}
or
\begin{equation}
\label{MaxVacFull1ninshtrgravortghhghgjkgghklhjgkghghjjkjhjkkggjkhjkhjjhhfhjhklkhkhjjklzzzyyyhjggjhgghhjhNWNWBWHWPPN222kkkgtytghjjhpoppkkkhhhkhjhjiuu34}
\vec v(\vec x)= -\frac{\sqrt{-2\Phi_1(|\vec x|)}}{|\vec x|}\vec x,
\end{equation}
exactly as in the case of the usual Newtonian gravity. In
particular, on the Earth surface we have:
\begin{equation}
\label{MaxVacFull1ninshtrgravortghhghgjkgghklhjgkghghjjkjhjkkggjkhjkhjjhhfhjhklkhkhjjklzzzyyyhjggjhgghhjhNWNWBWHWPPN222kkkgtytghjjhpoppkkkhhhkhjhjiuuokokok34}
|\vec v|=\sqrt{\frac{2Gm_0}{r_0}},
\end{equation}
where $r_0$ is the Earth radius and $m_0$ is the Earth mass, i.e.
the absolute value of the vectorial gravitational potential on the
Earth surface approximately equals to the escape velocity and its
direction is normal to the Earth, either downward or upward.
\end{remark}

\section{Covariant formulation of the physical laws in the
four-dimensional non-relativistic space-time}\label{CVFRM}
\subsection{Four-vectors, four-covectors and tensors in the
four-dimensional non-relativistic space-time} First of all we would
like to remind the definitions of the vectors, covectors and
covariant and contravariant tensors of second order in
$\mathbb{R}^4$.
\begin{definition}
Given $\mathcal{S}$, that is a certain subgroup of the group of all
smooth non-degenerate invertible transformations from $\mathbb{R}^4$
onto $\mathbb{R}^4$ having the form
\begin{equation}\label{fgjfjhgghyuyyu}
\begin{cases}
x'^0=f^{(0)}(x^0,x^1,x^2,x^3),\\
x'^1=f^{(1)}(x^0,x^1,x^2,x^3),\\
x'^2=f^{(2)}(x^0,x^1,x^2,x^3),\\
x'^3=f^{(3)}(x^0,x^1,x^2,x^3),
\end{cases}
\end{equation}
we say that a one-component field $a:=a(x^0,x^1,x^2,x^3)$ is a
scalar field on the group $\mathcal{S}$, if under the coordinate
transformation in the group $\mathcal{S}$ of the form
\er{fgjfjhgghyuyyu} this field transforms as:
\begin{equation}\label{fgjfjhgghhgjgihhi}
a'=a.
\end{equation}
Next we say that a four-component field $(a^0,a^1,a^2,a^3)$ is a
four-vector field on the group $\mathcal{S}$, if under the
coordinate transformation in the group $\mathcal{S}$ of the form
\er{fgjfjhgghyuyyu} every of four components of this field
transforms as:
\begin{equation}\label{fgjfjhgghhgjg}
a'^j=\sum_{k=0}^{3}\frac{\partial f^{(j)}}{\partial
x^k}a^k\quad\quad\forall j=0,1,2,3.
\end{equation}
Next we say that a four-component field $(a_0,a_1,a_2,a_3)$ is a
four-covector field on the group $\mathcal{S}$, if under the
coordinate transformation in the group $\mathcal{S}$ of the form
\er{fgjfjhgghyuyyu} every of four components of this field
transforms as:
\begin{equation}\label{fgjfjhgghhgjghjhj}
a_j=\sum_{k=0}^{3}\frac{\partial f^{(k)}}{\partial
x^j}a'_k\quad\quad\forall j=0,1,2,3.
\end{equation}
Furthermore, we say that a $16$-component field
$\{a_{mn}\}_{m,n=0,1,2,3}$ is a two times covariant tensor field on
the group $\mathcal{S}$, if under the coordinate transformation in
the group $\mathcal{S}$ of the form \er{fgjfjhgghyuyyu} every of
$16$ components of this field transforms as:
\begin{equation}\label{fgjfjhgghhgjghjhjkkkkjjk}
a_{mn}=\sum_{j=0}^{3}\sum_{k=0}^{3}\frac{\partial f^{(k)}}{\partial
x^m}\frac{\partial f^{(j)}}{\partial x^n}a'_{kj}\quad\quad\forall\,
m,n=0,1,2,3.
\end{equation}
Next we say that a $16$-component field $\{a^{mn}\}_{m,n=0,1,2,3}$
is a two times contravariant tensor field on the group
$\mathcal{S}$, if under the coordinate transformation in the group
$\mathcal{S}$ of the form \er{fgjfjhgghyuyyu} every of $16$
components of this field transforms as:
\begin{equation}\label{fgjfjhgghhgjghjhjkkkkgggh}
a'^{mn}=\sum_{j=0}^{3}\sum_{k=0}^{3}\frac{\partial f^{(m)}}{\partial
x^k}\frac{\partial f^{(n)}}{\partial x^j}a^{kj}\quad\quad\forall\,
m,n=0,1,2,3.
\end{equation}
Then it is well known that for every two four-vectors
$(a^0,a^1,a^2,a^3)$ and $(b^0,b^1,b^2,b^3)$ on $\mathcal{S}$, the
$16$-component field $\{c^{mn}\}_{m,n=0,1,2,3}$, defined in every
coordinate system by
\begin{equation}\label{fgjfjhgghhgjghjhjkkkkgjghghjljl}
c^{mn}:=a^mb^n\quad\quad\forall\, m,n=0,1,2,3,
\end{equation}
is a two times contravariant tensor on $\mathcal{S}$. Moreover, for
every two four-covectors $(a_0,a_1,a_2,a_3)$ and $(b_0,b_1,b_2,b_3)$
on $\mathcal{S}$, the $16$-component field
$\{c_{mn}\}_{m,n=0,1,2,3}$, defined in every coordinate system by
\begin{equation}\label{fgjfjhgghhgjghjhjkkkkgjghghkkkj}
c_{mn}:=a_mb_n\quad\quad\forall\, m,n=0,1,2,3,
\end{equation}
is a two times covariant tensor on $\mathcal{S}$. It is also well
known that if $\{a^{mn}\}_{m,n=0,1,2,3}$ is a two times
contravariant tensor field on the group $\mathcal{S}$ and if a
$16$-component field $\{b_{mn}\}_{m,n=0,1,2,3}$ satisfies
\begin{equation}\label{fgjfjhgghhgjghjhjkkkkgjghghuiiiu}
\sum_{k=0}^{3}a^{mk}b_{kn}=\begin{cases}
1\quad\text{if}\quad m=n\\
0\quad\text{if}\quad m\neq n
\end{cases}\quad\quad\forall\, m,n=0,1,2,3,
\end{equation}
then $\{b_{mn}\}_{m,n=0,1,2,3}$ is a two times covariant tensor on
$\mathcal{S}$. Next it is well known that, given a four-covector
$(a_0,a_1,a_2,a_3)$ a four-vector $(b^0,b^1,b^2,b^3)$, a two times
covariant tensor $\{c_{mn}\}_{m,n=0,1,2,3}$ and a two times
contravariant tensor $\{d^{mn}\}_{m,n=0,1,2,3}$ on the group
$\mathcal{S}$, the quantities
\begin{equation}\label{fgjfjhgghhgjghjhjkkkkgjghghuiiiukljk}
\sum_{k=0}^{3}a_kb^k\quad\text{and}\quad
\sum_{m=0}^{3}\sum_{n=0}^{3}c_{mn}d^{mn}
\end{equation}
are scalars on $\mathcal{S}$, the four-component fields defined by
\begin{equation}\label{fgjfjhgghhgjghjhjkkkkgjghghuiiiulkkj}
\Big\{\sum_{k=0}^{3}d^{mk}a_{k}\Big\}_{m=0,1,2,3}\quad\text{and}\quad
\Big\{\sum_{k=0}^{3}c_{mk}b^{k}\Big\}_{m=0,1,2,3}
\end{equation}
are four-vector and four-covector on $\mathcal{S}$ and moreover,
$16$-component fields $\{\hat c^{mn}\}_{m,n=0,1,2,3}$ and $\{\hat
d_{mn}\}_{m,n=0,1,2,3}$ defined by
\begin{equation}\label{fgjfjhgghhgjghjhjkkkkgjghghuiiiulkkjlkkl}
\hat
c^{mn}:=\sum_{k=0}^{3}\sum_{j=0}^{3}d^{mj}d^{nk}c_{jk}\quad\text{and}\quad
\hat
d_{mn}:=\sum_{j=0}^{3}\sum_{k=0}^{3}c_{mj}c_{nk}d^{jk}\quad\quad\forall\,
m,n=0,1,2,3,
\end{equation}
are two times contravariant and two times covariant tensors on
$\mathcal{S}$. Next, it is also well known that given a two times
covariant tensor $\{c_{mn}\}_{m,n=0,1,2,3}$ and a two times
contravariant tensor $\{d^{mn}\}_{m,n=0,1,2,3}$ on the group
$\mathcal{S}$ the $16$-component fields $\{c_{nm}\}_{m,n=0,1,2,3}$
and $\{d^{nm}\}_{m,n=0,1,2,3}$ are also two times covariant and two
times contravariant tensors on $\mathcal{S}$. Finally, it is well
known that, if $a:=a(x^0,x^1,x^2,x^3)$ is a scalar field on the
group $\mathcal{S}$, then the four-component field
$(w_0,w_1,w_2,w_3)$ defined by:
\begin{equation}\label{fgjfjhgghhgjghjhjkkkkgjghghuiiiulkkjlkklplikkl}
w_j:=\frac{\partial a}{\partial x^j}\quad\quad\forall\,j=0,1,2,3,
\end{equation}
is a \underline{four-covector} field on the group $\mathcal{S}$.
\end{definition}
Next consider the four-dimensional space-time $\mathbb{R}^4$, such
that for every point in space $\vec x=(x_1,x_2,x_3)\in\mathbb{R}^3$
and every instant of time $t$ we correspond the point
$(x^0,x^1,x^2,x^3)\in\mathbb{R}^4$ that has the form:
\begin{equation}\label{fgjfjhggh}
(x^0,x^1,x^2,x^3):=(ct,x_1,x_2,x_3)=\left(ct,\vec x\right),
\end{equation}
where $c$ is the universal constant in Maxwell equations for vacuum.
In this space we denote by $\mathcal{S}_0$, the subgroup of the
group of smooth non-degenerate invertible mappings, containing
transformations of the form
\begin{equation}\label{noninchgravortbstrjgghguittu2intrrrZZygjyghhj}
\begin{cases}
x'^0=x^0
\\
x'^j=\sum\limits_{k=1}^{3}A_{jk}\left(\frac{x^0}{c}\right)\,x_k+z_j\left(\frac{x^0}{c}\right)\quad\forall
j=1,2,3,
\end{cases}
\end{equation}
where $$\left\{A_{jk}(t)\right\}_{j,k=1,2,3}=A(t):\mathbb{R}\to
SO(3)$$ is a rotation, smoothly dependent on $t$ and
$$\left(z_1(t),z_2(t),z_3(t)\right)=\vec
z(t):\mathbb{R}\to\mathbb{R}^3$$ also smoothly dependent on $t$.
Then in the terms of time $t$ and three-dimensional space we rewrite
\er{noninchgravortbstrjgghguittu2intrrrZZygjyghhj} as:
\begin{equation}\label{noninchgravortbstrjgghguittu2intrrrZZygjyg}
\begin{cases}
\vec x'=A(t)\cdot\vec x+\vec z(t),\\
t'=t,
\end{cases}
\end{equation}
where $A(t)\in SO(3)$ is a rotation. I.e. the group $\mathcal{S}_0$
represents all transformations of cartesian non-inertial coordinate
systems in the non-relativistic space-time. It can be easily checked
by trivial calculations that $\mathcal{S}_0$ is indeed a group, i.e.
for every two transformations $f,g\in\mathcal{S}_0$ the composition
$g\circ f$ and the inverse transformation $f^{(-1)}$ are also
contained in $\mathcal{S}_0$, thats mean that they also have a form
of \er{noninchgravortbstrjgghguittu2intrrrZZygjyghhj}. Next assume
that a four-covector $(a_0,a_1,a_2,a_3)$ and a four-vector
$(b^0,b^1,b^2,b^3)$ on the group $\mathcal{S}_0$ are given. Then by
inserting \er{noninchgravortbstrjgghguittu2intrrrZZygjyghhj} into
\er{fgjfjhgghhgjg} and \er{fgjfjhgghhgjghjhj} we obtain the
following laws of transformations under the acting in the group
$\mathcal{S}_0$:
\begin{equation}\label{fgjfjhgghhgjghjhjijhoj}
\begin{cases}
a_0=a'_0+\sum_{k=1}^{3}\frac{1}{c}\left(\sum\limits_{j=1}^{3}\frac{dA_{kj}}{dt}\left(\frac{x^0}{c}\right)\,x_j+\frac{d
z_k}{dt}\left(\frac{x^0}{c}\right)\right)\,a'_k
\\
a_j=\sum_{k=1}^{3}A_{kj}\left(\frac{x^0}{c}\right)\,a'_k\quad\quad\forall
j=1,2,3,
\end{cases}
\end{equation}
and
\begin{equation}\label{fgjfjhgghhgjgiuouoiuu}
\begin{cases}
b'^0=b^0
\\
b'^j=\frac{1}{c}\left(\sum\limits_{k=1}^{3}\frac{dA_{jk}}{dt}\left(\frac{x^0}{c}\right)\,x_k+\frac{d
z_j}{dt}\left(\frac{x^0}{c}\right)\right)\,b^0+\sum_{k=1}^{3}A_{jk}\left(\frac{x^0}{c}\right)\,b^k\quad\quad\forall
j=1,2,3.
\end{cases}
\end{equation}
In particular, since $A(t)\in SO(3)$ and thus
\begin{equation}\label{fgjfjhgghhgjgiuouoiuu1}
\sum_{j=1}^{3}A_{mj}(t)A_{nj}(t)=\begin{cases}1\quad\text{if}\quad m=n\\
0\quad\text{if}\quad m\neq n
\end{cases}
\quad\quad\quad\forall m,n=1,2,3,
\end{equation}
by \er{fgjfjhgghhgjghjhjijhoj} we deduce:
\begin{equation}\label{fgjfjhgghhgjghjhjijhojpiii}
\begin{cases}
a'_0=a_0-\sum_{k=1}^{3}\frac{1}{c}\left(\sum\limits_{j=1}^{3}\frac{dA_{kj}}{dt}\left(\frac{x^0}{c}\right)\,x_j+\frac{d
z_k}{dt}\left(\frac{x^0}{c}\right)\right)\left(\sum_{j=1}^{3}A_{kj}\left(\frac{x^0}{c}\right)\,a_j\right)
\\
a'_k=\sum_{j=1}^{3}A_{kj}\left(\frac{x^0}{c}\right)\,a_j\quad\quad\forall
k=1,2,3.
\end{cases}
\end{equation}
So, by \er{fgjfjhgghhgjghjhjijhojpiii} and
\er{fgjfjhgghhgjgiuouoiuu} we obtained the following laws of
transformation of four-covectors and four-vectors in the group
$\mathcal{S}_0$, i.e. under the change of non-inertial cartesian
coordinate systems:
\begin{equation}\label{fgjfjhgghhgjghjhjijhojpiiihjhj}
\begin{cases}
a'_0=a_0-\sum_{k=1}^{3}\frac{1}{c}\left(\sum\limits_{j=1}^{3}\frac{dA_{kj}}{dt}\left(\frac{x^0}{c}\right)\,x_j+\frac{d
z_k}{dt}\left(\frac{x^0}{c}\right)\right)\left(\sum_{j=1}^{3}A_{kj}\left(\frac{x^0}{c}\right)\,a_j\right)
\\
a'_k=\sum_{j=1}^{3}A_{kj}\left(\frac{x^0}{c}\right)\,a_j\quad\quad\forall
k=1,2,3,
\end{cases}
\end{equation}
and
\begin{equation}\label{fgjfjhgghhgjgiuouoiuujkjkjk}
\begin{cases}
b'^0=b^0
\\
b'^j=\frac{1}{c}\left(\sum\limits_{k=1}^{3}\frac{dA_{jk}}{dt}\left(\frac{x^0}{c}\right)\,x_k+\frac{d
z_j}{dt}\left(\frac{x^0}{c}\right)\right)\,b^0+\sum_{k=1}^{3}A_{jk}\left(\frac{x^0}{c}\right)\,b^k\quad\quad\forall
j=1,2,3.
\end{cases}
\end{equation}
%
%
%
%
%
%
Therefore, if we denote the four-vector $(b^0,b^1,b^2,b^3)$ and the
four-covector $(a_0,a_1,a_2,a_3)$ on the group $\mathcal{S}_0$ as:
\begin{equation}\label{fgjfjhgghhgjghjhjijhojihjhjjijhjjj}
\begin{cases}
(b^0,b^1,b^2,b^3)=\left(\sigma,\frac{1}{c}\vec
b\right)\quad\text{where}\quad
\sigma:=b^0\;\;\text{and}\;\;\vec b:=c(b^1,b^2,b^3)\in\mathbb{R}^3,\\
(a_0,a_1,a_2,a_3)=(\psi,-\vec a)\quad\text{where}\quad
\psi:=a_0\;\;\text{and}\;\;\vec a:=-(a_1,a_2,a_3)\in\mathbb{R}^3,
\end{cases}
\end{equation}
then by \er{fgjfjhgghhgjghjhjijhojpiiihjhj} and
\er{fgjfjhgghhgjgiuouoiuujkjkjk} in the terms of time $t$ and three
dimensional space $\vec x$, we obtain the following laws of
transformations of $\sigma$, $\vec b$, $\psi$ and $\vec a$ under the
change of non-inertial cartesian coordinate system:
\begin{equation}\label{fgjfjhgghhgjgiuouoiuujkjkjkojkoiu}
\begin{cases}
\sigma'=\sigma
\\
\vec b'=A\left(t\right)\cdot\vec
b+\left(\frac{dA}{dt}\left(t\right)\cdot\vec x+\frac{d \vec
z}{dt}\left(t\right)\right)\sigma,
\end{cases}
\end{equation}
and
\begin{equation}\label{fgjfjhgghhgjghjhjijhojpiiihjhjuiouj}
\begin{cases}
\psi'=\psi+\frac{1}{c}\left(\frac{dA}{dt}\left(t\right)\cdot\vec
x+\frac{d \vec z}{dt}\left(t\right)\right)\cdot\left(A(t)\cdot\vec
a\right)
\\
\vec a'=A(t)\cdot\vec a.
\end{cases}
\end{equation}
In particular, if $\sigma:=b^0$ is the first coordinate of an
arbitrary four-vector $(b^0,b^1,b^2,b^3)$ on the group
$\mathcal{S}_0$, then $\sigma$ is a proper scalar field in the
frames of Definition \ref{bggghghgj}. Moreover, if $\vec
a:=-(a_1,a_2,a_3)$, where $a_1,a_2,a_3$ are the last three
coordinates of an arbitrary four-covector $(a_0,a_1,a_2,a_3)$ on the
group $\mathcal{S}_0$, then $\vec a$ is a proper vector field in the
frames of Definition \ref{bggghghgj}.

Next, since by Definition \ref{bggghghgj} every three-dimensional
speed-like vector field $\vec u$ transforms under the change of
non-inertial cartesian coordinate system as:
\begin{equation}
\label{NoIn3redPPN'}\vec u'=A(t)\cdot \vec
u+\frac{dA}{dt}\left(t\right)\cdot\vec x+\frac{d \vec
z}{dt}\left(t\right),
\end{equation}
by comparing \er{NoIn3redPPN'} with
\er{fgjfjhgghhgjgiuouoiuujkjkjkojkoiu} we deduce that for every
speed-like vector field $\vec u$ the four-component field
$(u^0,u^1,u^2,u^3)$ defined by
\begin{equation}\label{fgjfjhgghhgjghjhjijhojihjhjjijhjjjjjuii}
(u^0,u^1,u^2,u^3):=\left(1,\frac{1}{c}\vec
u\right)\quad\text{where}\quad
u^0=1\;\;\text{and}\;\;(u^1,u^2,u^3)=\frac{1}{c}\vec
u\in\mathbb{R}^3,
\end{equation}
is a four-vector field on the group $\mathcal{S}_0$. We call such
four-vectors by the name vectors of type $1$. In particular, if
$\vec u$ is the velocity field, then the quantity defined by
\er{fgjfjhgghhgjghjhjijhojihjhjjijhjjjjjuii} is a a four-vector
field on the group $\mathcal{S}_0$ that we call the four-dimensional
speed. Regarding the field of velocity $\vec u$ we also can give a
different argumentation that the four-component field
$(u^0,u^1,u^2,u^3)$ defined by
\er{fgjfjhgghhgjghjhjijhojihjhjjijhjjjjjuii} is a four-vector field
on the group $\mathcal{S}_0$: indeed it is well known from Tensor
Analysys that if $\left(x^0(s),x^1(s),x^2(s),x^3(s)\right)$ is a
curve in $\mathbb{R}^4$, parameterized by some scalar parameter $s$,
then the four-component field
$\left(\frac{dx^0}{ds}(s),\frac{dx^1}{ds}(s),\frac{dx^2}{ds}(s),\frac{dx^3}{ds}(s)\right)$
is a four-vector field on an arbitrary group $\mathcal{S}$ and, in
particular, on the group $\mathcal{S}_0$. Thus, if $\vec
r(t)=\left(r_1(t),r_2(t),r_3(t)\right)$ is a three-dimensional
trajectory of the motion of some particle, parameterized by the
global time $t$, then if we consider a curve
$\frac{1}{c}\left(ct,r_1(t),r_2(t),r_3(t)\right)$ in $\mathbb{R}^4$,
parameterized by the global time $t$, then the four-component field:
\begin{equation}\label{fgjfjhgghhgjghjhjijhojihjhjjijhjjjjjuiijhjhh}
\left(1,\frac{1}{c}\frac{d\vec r}{dt}(t)\right):=
\left(1,\frac{1}{c}\frac{dr_1}{dt}(t),\frac{1}{c}\frac{dr_2}{dt}(t),\frac{1}{c}\frac{dr_3}{dt}(t)\right)
\end{equation}
is a four-vector field on the group $\mathcal{S}_0$.

 Similarly, if $\vec v$ is the vectorial gravitational
potential, then since $\vec v$ is a speed-like vector field, the
four-component field $(v^0,v^1,v^2,v^3)$ defined by
\begin{equation}\label{fgjfjhgghhgjghjhjijhojihjhjjijhjjjjjuiijjjk}
(v^0,v^1,v^2,v^3):=\left(1,\frac{1}{c}\vec
v\right)\quad\text{where}\quad
v^0=1\;\;\text{and}\;\;(v^1,v^2,v^3)=\frac{1}{c}\vec v,
\end{equation}
is also a four-vector field on the group $\mathcal{S}_0$ that we
call the four-dimensional gravitational potential. In the same way,
if $\vec k$ is the vectorial potential of inertia, defined by
Definition \ref{jjhgyftdtd}, then since $\vec k$ is a speed-like
vector field, the four-component field $(k^0,k^1,k^2,k^3)$ defined
by
\begin{equation}\label{fgjfjhgghhgjghjhjijhojihjhjjijhjjjjjuiijjjkjkkhhkkh}
(k^0,k^1,k^2,k^3):=\left(1,\frac{1}{c}\vec
k\right)\quad\text{where}\quad
k^0=1\;\;\text{and}\;\;(k^1,k^2,k^3)=\frac{1}{c}\vec k,
\end{equation}
is also a four-vector field on the group $\mathcal{S}_0$ that we
call the four-dimensional potential of inertia.

Moreover, by \er{fgjfjhgghhgjgiuouoiuujkjkjkojkoiu}, for every speed
like vector field $\vec u$ and every proper scalar field $\sigma$
the four-component field $(b^0,b^1,b^2,b^3)$ defined by
\begin{equation}\label{fgjfjhgghhgjghjhjijhojihjhjjijhjjjjjuiijjjkl}
(b^0,b^1,b^2,b^3):=\left(\sigma,\frac{\sigma}{c}\vec
u\right)\quad\text{where}\quad
b^0=\sigma\;\;\text{and}\;\;(b^1,b^2,b^3)=\frac{\sigma}{c}\vec u,
\end{equation}
is also a four-vector field on the group $\mathcal{S}_0$. In
particular, if we consider the field of four-dimensional moment of a
particle $(p^0,p^1,p^2,p^3)$ defined by
\begin{equation}\label{fgjfjhgghhgjghjhjijhojihjhjjijhjjjjjuiijjjklihh}
(p^0,p^1,p^2,p^3):=\left(m,\frac{1}{c}(m\vec
u)\right)\quad\text{where}\quad
p^0=m\;\;\text{and}\;\;(p^1,p^2,p^3)=\frac{1}{c}(m\vec u),
\end{equation}
where $m$ is the mass of the particle and $\vec u$ is the velocity
of the particle, then $(p^0,p^1,p^2,p^3)$ is also a four-vector on
the group $\mathcal{S}_0$. Moreover, by comparing \er{NoIn4redPPN}
and \er{NoIn6redPPN} with \er{fgjfjhgghhgjgiuouoiuujkjkjkojkoiu} we
deduce that if we consider the field of four-dimensional electric
current $(j^0,j^1,j^2,j^3)$ defined by
\begin{equation}\label{fgjfjhgghhgjghjhjijhojihjhjjijhjjjjjuiijjjklihhojjjo}
(j^0,j^1,j^2,j^3):=\left(\rho,\frac{1}{c}\,\vec
j\right)\quad\text{where}\quad
j^0=\rho\;\;\text{and}\;\;(j^1,j^2,j^3)=\frac{1}{c}\,\vec j,
\end{equation}
where $\rho$ is the electric charge density and $\vec j$ is the
electric current density, then $(j^0,j^1,j^2,j^3)$ is also a
four-vector on the group $\mathcal{S}_0$.

On the other hand, for every proper three-dimensional vector field
$\vec G$ that satisfies due to Definition \ref{bggghghgj}:
\begin{equation}
\label{NoIn1redPPN'}\vec G'=A(t)\cdot\vec G,
\end{equation}
by comparing \er{NoIn1redPPN'} with
\er{fgjfjhgghhgjgiuouoiuujkjkjkojkoiu} we deduce that the
four-component field $(G^0,G^1,G^2,G^3)$ defined by
\begin{equation}\label{fgjfjhgghhgjghjhjijhojihjhjjijhjjjjjuiiklkllo;}
(G^0,G^1,G^2,G^3):=\left(0,\vec G\right)\quad\text{where}\quad
G^0=0\;\;\text{and}\;\;(G^1,G^2,G^3)=\vec G,
\end{equation}
is also a four-vector field on the group $\mathcal{S}_0$.  We call
such four-vectors by the name vectors of type $0$.

Next, since by
\er{vhfffngghhjghhgPPNghghghutghfflklhjkjhjhjjgjkghhj} the scalar
electromagnetic potential $\Psi$ and the vector electromagnetic
potential $\vec A$, under the change of non-inertial cartesian
coordinate system transform as:
\begin{equation}\label{vhfffngghhjghhgPPNghghghutghfflklhjkjhjhjjgjkghhjhhjhjkhjghlkk}
\begin{cases}
\Psi'=
\Psi+\frac{1}{c}\left(\frac{dA}{dt}(t)\cdot\vec x+\frac{d\vec
z}{dt}(t)\right)\cdot\left(A(t)\cdot\vec A\right)
\\
\vec A'=A(t)\cdot \vec A,
\end{cases}
\end{equation}
by comparing
\er{vhfffngghhjghhgPPNghghghutghfflklhjkjhjhjjgjkghhjhhjhjkhjghlkk}
with \er{fgjfjhgghhgjghjhjijhojpiiihjhjuiouj} we deduce that the
four-component field $(A_0,A_1,A_2,A_3)$ defined as
\begin{equation}\label{fgjfjhgghhgjghjhjijhojihjhjjijhjjjljljpk}
(A_0,A_1,A_2,A_3)=(\Psi,-\vec A)\quad\text{where}\quad
A_0=\Psi\;\;\text{and}\;\;(A_1,A_2,A_3)=-\vec A,
\end{equation}
is a four-\underline{covector} field on the group $\mathcal{S}_0$.
We call this  four-covector field by the name four dimensional
electromagnetic potential. Next, since $(A_0,A_1,A_2,A_3)$ is a
four-covector field on the group $\mathcal{S}_0$, then it is well
known from the tensor analysis that the $16$-component field
$\{F_{ij}\}_{0\leq i,j\leq 3}$ defined in every non-inertial
cartesian coordinate system by
\begin{equation}\label{huohuioy89gjjhjffffff3478zzrrZZZhjhhjhhjjhhffGGhjjh}
F_{ij}:=\frac{\partial A_j}{\partial x^i}-\frac{\partial
A_i}{\partial x^j}\quad\quad\forall\, i,j=0,1,2,3\,,
\end{equation}
is an antisymmetric two times covariant tensor field on the group
$\mathcal{S}_0$, which we call the covariant tensor of the
electromagnetic field. In particular, by inserting
\er{fgjfjhgghhgjghjhjijhojihjhjjijhjjjljljpk} and \er{fgjfjhggh}
into \er{huohuioy89gjjhjffffff3478zzrrZZZhjhhjhhjjhhffGGhjjh} we
deduce:
\begin{equation}\label{huohuioy89gjjhjffffff3478zzrrZZZhjhhjhhjjhhffGGhjjhiuui}
\begin{cases}
F_{00}=0\\ F_{0j}=-F_{j0}=-\frac{1}{c}\frac{\partial(-A_j)}{\partial
t}-\frac{\partial \Psi}{\partial
x^j}\quad\quad\forall\, j=1,2,3\\
F_{jj}=0\quad\quad\forall\, j=1,2,3
\\
F_{ij}=-F_{ji}=\frac{\partial (-A_i)}{\partial x^j}-\frac{\partial
(-A_j)}{\partial x^i}\quad\quad\forall\, i\neq j=1,2,3\,,
\end{cases}
\end{equation}
Thus if as in \er{MaxVacFull1bjkgjhjhgjgjgkjfhjfdghghligioiuittrPPN}
we denote:
\begin{equation}\label{MaxVacFull1bjkgjhjhgjgjgkjfhjfdghghligioiuittrPPNkkk}
\begin{cases}
\vec B:= curl_{\vec x} \vec A,\\
\vec E:=-\nabla_{\vec x}\Psi-\frac{1}{c}\frac{\partial\vec
A}{\partial t},
\end{cases}
\end{equation}
then denoting  $\vec E:=(E_1,E_2,E_3)$ and $\vec B:=(B_1,B_2,B_3)$,
by \er{MaxVacFull1bjkgjhjhgjgjgkjfhjfdghghligioiuittrPPNkkk}
we rewrite
\er{huohuioy89gjjhjffffff3478zzrrZZZhjhhjhhjjhhffGGhjjhiuui} as:
\begin{equation}\label{huohuioy89gjjhjffffff3478zzrrZZZhjhhjhhjjhhffGGhjjhiuuijkjjk}
\begin{cases}
F_{00}=0\\ F_{0j}=-F_{j0}=E_j\quad\quad\forall\, j=1,2,3\\
F_{jj}=0\quad\quad\forall\, j=1,2,3
\\
F_{12}=-F_{21}=-B_3
\\
F_{13}=-F_{31}=B_2
\\
F_{23}=-F_{32}=-B_1\,.
\end{cases}
\end{equation}

Next assume that $T:=\left\{T_{ij}\right\}_{i,j=1,2,3}\in
\R^{3\times 3}$ is a $9$-component proper matrix valued field,
which, being a proper matrix field, by Definition \ref{bggghghgj}
satisfies:
\begin{equation}\label{uguyytfddddgghjjghjjjihohjjk}
T'=A(t)\cdot T\cdot A^T(t)=A(t)\cdot T\cdot
\left\{A(t)\right\}^{-1}.
\end{equation}
Next consider a $16$-component field $\{\mathcal{T}^{ij}\}_{0\leq
i,j\leq 3}$ defined in every non-inertial cartesian coordinate
system by
\begin{equation}\label{huohuioy89gjjhjffffff3478zzrrZZZhjhhjhhjjhhffGGhjjhuiiuihhj}
\begin{cases}
\mathcal{T}^{00}=0
\\
\mathcal{T}^{0j}=\mathcal{T}^{j0}=0\quad\quad\forall\, j=1,2,3
\\
\mathcal{T}^{ij}:=T_{ij}\quad\quad\forall\, i,j=1,2,3\,,
\end{cases}
\end{equation}
Then by inserting \er{noninchgravortbstrjgghguittu2intrrrZZygjyghhj}
and \er{uguyytfddddgghjjghjjjihohjjk} into
\er{fgjfjhgghhgjghjhjkkkkgggh}, we can prove that the field
$\{\mathcal{T}^{ij}\}_{0\leq i,j\leq 3}$ defined by
\er{huohuioy89gjjhjffffff3478zzrrZZZhjhhjhhjjhhffGGhjjhuiiuihhj} is
a two times \underline{contravariant} tensor field on the group
$\mathcal{S}_0$. Indeed, by \er{fgjfjhgghhgjghjhjkkkkgggh} for every
two times contravariant tensor field $\{a^{ij}\}_{0\leq i,j\leq 3}$
we have
\begin{multline}\label{fgjfjhgghhgjghjhjkkkkggghkjjk}
a'^{mn}=\frac{\partial f^{(m)}}{\partial x^0}\frac{\partial
f^{(n)}}{\partial x^0}a^{00}+\sum_{k=1}^{3}\frac{\partial
f^{(m)}}{\partial x^k}\frac{\partial f^{(n)}}{\partial
x^0}a^{k0}+\sum_{j=1}^{3}\frac{\partial f^{(m)}}{\partial
x^0}\frac{\partial f^{(n)}}{\partial
x^j}a^{0j}\\+\sum_{j=1}^{3}\sum_{k=1}^{3}\frac{\partial
f^{(m)}}{\partial x^k}\frac{\partial f^{(n)}}{\partial
x^j}a^{kj}\quad\quad\forall\, m,n=0,1,2,3.
\end{multline}
Then, since by \er{noninchgravortbstrjgghguittu2intrrrZZygjyghhj} we
have $\frac{\partial f^{(0)}}{\partial x^0}=1$, $\frac{\partial
f^{(0)}}{\partial x^k}=0$ $\forall k=1,2,3$ and $\frac{\partial
f^{(m)}}{\partial x^k}=A_{mk}\left(\frac{x^0}{c}\right)$ $\forall
k,m=1,2,3$, in the case where $a^{00}=0$ and $a^{0j}=a^{j0}=0$
$\forall j=1,2,3$  we rewrite \er{fgjfjhgghhgjghjhjkkkkggghkjjk} as:
\begin{equation}\label{fgjfjhgghhgjghjhjkkkkggghjjkjhgj}
\begin{cases}
a'^{00}=0\\
a'^{j0}=a'^{0j}=0\quad\quad\forall\, j=1,2,3,
\\
a'^{mn}=\sum_{j=1}^{3}\sum_{k=1}^{3}A_{mk}\left(\frac{x^0}{c}\right)A_{nj}\left(\frac{x^0}{c}\right)a^{kj}\quad\quad\forall\,
m,n=1,2,3.
\end{cases}
\end{equation}
that is compatible with
\er{huohuioy89gjjhjffffff3478zzrrZZZhjhhjhhjjhhffGGhjjhuiiuihhj} and
\er{uguyytfddddgghjjghjjjihohjjk}.

In particular, if we consider the $9$-component matrix field $I$
that defined in every cartesian coordinate system as
$I:=\left\{\delta_{ij}\right\}_{1,j=1,2,3}\in \R^{3\times 3}$, where
\begin{equation}\label{huohuioy89gjjhjffffff3478zzrrZZZhjhhjhhjjhhffGGhjjhuiiuihhjkklpkl}
\delta_{ij}:=
\begin{cases}
1\quad\text{if}\quad i=j
\\
0\quad\text{if}\quad i\neq j,
\end{cases}
\end{equation}
which is a proper matrix field, since
\begin{equation}\label{uguyytfddddgghjjghjjjihohjjkkkook}
I=A(t)\cdot I\cdot \left\{A(t)\right\}^{-1},
\end{equation}
then the $16$-component field $\{{\Theta}^{ij}\}_{0\leq i,j\leq 3}$
defined in every non-inertial cartesian coordinate system by
\begin{equation}\label{huohuioy89gjjhjffffff3478zzrrZZZhjhhjhhjjhhffGGhjjhuiiuihhjkkoioi}
\begin{cases}
{\Theta}^{00}=0
\\
{\Theta}^{0j}={\Theta}^{j0}=0\quad\quad\forall\, j=1,2,3
\\
{\Theta}^{ij}:=\delta_{ij}\quad\quad\forall\, i,j=1,2,3
\end{cases}
\end{equation}
is a two times contravariant tensor field on the group
$\mathcal{S}_0$ and moreover, this tensor is symmetric. We call
$\{{\Theta}^{ij}\}_{0\leq i,j\leq 3}$ the contravariant tensor of
the three-dimensional geometry.

Next, the scalar field $\tau:=\tau(x^0,x^1,x^2,x^3)$, defined in
every cartesian coordinate system as
\begin{equation}\label{uguyytfddddgghkjhhjuuiui}
\tau:=\frac{x^0}{c}=t,
\end{equation}
is a scalar on the group $\mathcal{S}_0$. Here $t$ is the global
non-relativistic time. Moreover, by
\er{fgjfjhgghhgjghjhjkkkkgjghghuiiiulkkjlkklplikkl} and
\er{uguyytfddddgghkjhhjuuiui}, the four-component field
$(v_0,v_1,v_2,v_3)$ defined by:
\begin{equation}\label{fgjfjhgghhgjghjhjkkkkgjghghuiiiulkkjlkklplikklkjljghhgghk}
v_0:=c\,\frac{\partial \tau}{\partial
x^0}(x^0,x^1,x^2,x^3)=1\quad\text{and}\quad v_j:=c\,\frac{\partial
\tau}{\partial x^j}(x^0,x^1,x^2,x^3)=0\quad\quad\forall\,j=1,2,3,
\end{equation}
is a \underline{four-covector} field on the group $\mathcal{S}_0$.

Finally, consider a motion of a classical particle with inertial
mass $m$, charge $\sigma$, place $\vec r(t)$ and velocity $\vec
u(t)=\vec r'(t)$ in the outer gravitational field with the vectorial
gravitational potential $\vec v(\vec x,t)$, the outer
electromagnetic field with vectorial and scalar potentials $\vec
A(\vec x,t)$ and $\vec \Psi(\vec x,t)$, and additional conservative
field with scalar potential $V(\vec x,t)$ ruled by a Lagrangian
\er{vhfffngghkjgghfjjint}:
\begin{equation}\label{vhfffngghkjgghfjjuuu}
L_0\left(\frac{d\vec r}{dt},\vec
r,t\right):=\frac{m}{2}\left|\frac{d\vec r}{dt}-\vec v(\vec
r,t)\right|^2-\sigma\left(\Psi(\vec r,t)-\frac{1}{c}\vec A(\vec
r,t)\cdot\frac{d\vec r}{dt}\right)+V(\vec r,t).
\end{equation}
Then $L_0$ is a scalar on the group $\mathcal{S}_0$. Moreover,
consider the generalized momentum of the particle $m$ by
\er{guytyurtydftyiujhint}:
\begin{equation}\label{guytyurtydftyiujhuuu}
\vec P:=\nabla_{\vec r'}L_0\left(\vec r',\vec r,t\right)=m
\frac{d\vec r}{dt}-m\vec v(\vec r,t)+\frac{\sigma}{c}\vec A(\vec
r,t),
\end{equation}
consider a Hamiltonian
\begin{equation}\label{vhfffngghkjgghfjjhyjjfguuu}
H_0\left(\vec P,\vec r,t\right):=\vec P\cdot\frac{d\vec
r}{dt}-L_0\left(\frac{d\vec r}{dt},\vec r,t\right),
\end{equation}
which by \er{vhfffngghkjgghfjjghghghint} satisfies
\begin{equation}\label{vhfffngghkjgghfjjghghghuuu} H_0\left(\vec
P,\vec r,t\right)= \vec P\cdot\vec v(\vec r,t)+
\frac{1}{2m}\left|\vec P-\frac{\sigma}{c}\vec A(\vec
r,t)\right|^2+\sigma\left(\Psi(\vec r,t)-\frac{1}{c}\vec A(\vec
r,t)\cdot\vec v(\vec r,t)\right)-V\left(\vec r,t\right),
\end{equation}
and furthermore, define the four-dimensional generalized momentum
$(P_0,P_1,P_2,P_3)$ as:
\begin{equation}\label{fgjfjhgghhgjghjhjijhojihjhjjijhjjjljljpkhkhjgjg}
(P_0,P_1,P_2,P_3):=\left(\frac{1}{c}H_0,-\vec
P\right)\quad\text{where}\quad
P_0=\frac{1}{c}H_0\;\;\text{and}\;\;(P_1,P_2,P_3)=-\vec P,
\end{equation}
Then, since by \er{vhfffngghkjgghfjjghghghuuu} and
\er{guytyurtydftyiujhuuu}, under the change of non-inertial
cartesian coordinate system $H_0$ and $\vec P$ transform as
\begin{equation}\label{vhfffngghhjghhgPPNghghghutghfflklhjkjhjhjjgjkghhjhhjhjkhjghlkkll}
\begin{cases}
H_0'=
H_0+\left(\frac{dA}{dt}(t)\cdot\vec x+\frac{d\vec
z}{dt}(t)\right)\cdot\left(A(t)\cdot\vec P\right)
\\
\vec P'=A(t)\cdot \vec P,
\end{cases}
\end{equation}
by comparing
\er{vhfffngghhjghhgPPNghghghutghfflklhjkjhjhjjgjkghhjhhjhjkhjghlkkll}
with \er{fgjfjhgghhgjghjhjijhojpiiihjhjuiouj} we deduce that the
four-dimensional momentum $(P_0,P_1,P_2,P_3)$ is a
four-\underline{covector} on the group $\mathcal{S}_0$.

\subsection{Pseudo-metric tensors of the four-dimensional
space-time}\label{jjjjjkkk}
Consider $\{g^{ij}\}_{0\leq i,j\leq 3}$
to be a two times contravariant tensor field on the group
$\mathcal{S}_0$, defined by
\begin{equation}\label{hoyuiouigyfghgjh3478zzrrZZffGGjkkjojj}
g^{ij}:=v^iv^j-{\Theta}^{ij}\quad\quad\forall\,i,j=0,1,2,3,
\end{equation}
where $\{{\Theta}^{ij}\}_{0\leq i,j\leq 3}$ is the contravariant
tensor of the three-dimensional geometry, defined by
\er{huohuioy89gjjhjffffff3478zzrrZZZhjhhjhhjjhhffGGhjjhuiiuihhjkkoioi}
and being a two times contravariant tensor, and $(v^0,v^1,v^2,v^3)$
is the four-dimensional gravitational potential, defined by
\er{fgjfjhgghhgjghjhjijhojihjhjjijhjjjjjuiijjjk} and being a
four-vector. Then by \er{fgjfjhgghhgjghjhjkkkkgjghghjljl} we obtain
that $\{g^{ij}\}_{0\leq i,j\leq 3}$ is indeed a two times
contravariant tensor field on the group $\mathcal{S}_0$ and
moreover, this tensor is symmetric. Moreover, by
\er{huohuioy89gjjhjffffff3478zzrrZZZhjhhjhhjjhhffGGhjjhuiiuihhjkkoioi}
and \er{fgjfjhgghhgjghjhjijhojihjhjjijhjjjjjuiijjjk} we have:
\begin{equation}\label{hoyuiouigyfghgjh3478zzrrZZffGGjkkj}
\begin{cases}
g^{00}=1\\
g^{ij}=-\delta_{ij}+\frac{v^iv^j}{c^2}\quad\forall 1\leq i,j\leq 3\\
g^{0j}=g^{j0}=\frac{v^j}{c}\quad\forall 1\leq j\leq 3,
\end{cases}
\end{equation}
where $\vec v=(v^1,v^2,v^3)$ is the three-dimensional vectorial
gravitational potential. We call the tensor $\{g^{ij}\}_{0\leq
i,j\leq 3}$ the contravariant pseudo-metric tensor of the
four-dimensional space-time. Next consider a $16$-component field
$\{g_{ij}\}_{0\leq i,j\leq 3}$ defined by
\begin{equation}\label{hoyuiouigyfg3478zzrrZZffGGhhjhj}
\begin{cases}
g_{00}=1-\frac{|\vec v|^2}{c^2}\\
g_{ij}=-\delta_{ij}\quad\forall 1\leq i,j\leq 3\\
g_{0j}=g_{j0}=\frac{v^j}{c}\quad\forall 1\leq j\leq 3.
\end{cases}
\end{equation}
 Then
\begin{equation*}
\sum_{k=0}^{3}g_{0k}g^{k0}=g_{00}g^{00}+\sum_{k=1}^{3}g_{0k}g^{k0}=
1-\frac{|\vec v|^2}{c^2}+\frac{|\vec v|^2}{c^2}=1,
\end{equation*}
\begin{equation*}
\sum_{k=0}^{3}g_{ik}g^{kj}=g_{i0}g^{0j}+\sum_{k=1}^{3}g_{ik}g^{kj}=
\frac{v^iv^j}{c^2}+\delta_{ij}-\frac{v^iv^j}{c^2}=\delta_{ij}\quad\forall
1\leq i,j\leq 3,
\end{equation*}
and
\begin{equation*}
\sum_{k=0}^{3}g_{ik}g^{k0}=g_{i0}g^{00}+\sum_{k=1}^{3}g_{ik}g^{k0}=\frac{v^i}{c}-\frac{v^i}{c}=0
\quad\forall 1\leq i\leq 3,
\end{equation*}
\begin{multline*}
\sum_{k=0}^{3}g_{0k}g^{kj}=g_{00}g^{0j}+\sum_{k=1}^{3}g_{0k}g^{kj}=
\left(1-\frac{|\vec v|^2}{c^2}\right)\frac{v^j}{c}
-\sum_{k=1}^{3}\frac{v^k}{c}\left(\delta_{kj}-\frac{v^kv^j}{c^2}\right)=0
\quad\forall 1\leq j\leq 3,
\end{multline*}
where $\{g^{ij}\}_{0\leq i,j\leq 3}$ is the contravariant
pseudo-metric tensor of the four-dimensional space-time, defined by
\er{hoyuiouigyfghgjh3478zzrrZZffGGjkkj}. So,
\begin{equation}\label{fgjfjhgghhgjghjhjkkkkgjghghuiiiuujhjh}
\sum_{k=0}^{3}g^{ik}g_{kj}=\begin{cases}
1\quad\text{if}\quad i=j\\
0\quad\text{if}\quad i\neq j
\end{cases}\quad\quad\forall\, i,j=0,1,2,3.
\end{equation}
Therefore, by comparing \er{fgjfjhgghhgjghjhjkkkkgjghghuiiiuujhjh}
and \er{fgjfjhgghhgjghjhjkkkkgjghghuiiiu} we deduce that
$\{g_{ij}\}_{i,j=0,1,2,3}$ is a two times covariant tensor on the
group $\mathcal{S}_0$, and moreover, this tensor is symmetric. We
call the tensor $\{g_{ij}\}_{0\leq i,j\leq 3}$ covariant
pseudo-metric tensor of the four-dimensional space-time. Using
\er{fgjfjhgghhgjghjhjkkkkgjghghuiiiuujhjh} we also obtain that the
pseudo-metric tensors $\{g_{ij}\}_{i,j=0,1,2,3}$ and
$\{g^{ij}\}_{0\leq i,j\leq 3}$ are non-degenerate and they are
inverse of each other. Moreover, it can be easily calculated that if
we consider the $4\times 4$-matrix:
\begin{equation}\label{fgjfjhgghhgjghjhjkkkkgjghghuiiiuujhjhjklj}
G=\{g_{ij}\}_{0\leq i,j\leq 3},
\end{equation}
then
\begin{equation}\label{fgjfjhgghhgjghjhjkkkkgjghghuiiiuujhjh1}
\text{det}\,G=-1.
\end{equation}

Thus, with the covariant and contravariant pseudo-metric tensors we
can lower and lift indexes of arbitrary tensors. In particular given
a four-covector $(a_0,a_1,a_2,a_3)$ and a four-vector
$(b^0,b^1,b^2,b^3)$ on the group $\mathcal{S}_0$ we can define the
corresponding lifted four-vector $(a^0,a^1,a^2,a^3)$ and the
corresponded lowered four-covector $(b_0,b_1,b_2,b_3)$ by
\begin{equation}\label{fgjfjhgghhgjghjhjkkkkgjghghuiiiulkkjKK}
(a^0,a^1,a^2,a^3):=\Big\{\sum_{k=0}^{3}g^{mk}a_{k}\Big\}_{m=0,1,2,3}\quad\text{and}\quad
(b_0,b_1,b_2,b_3):=\Big\{\sum_{k=0}^{3}g_{mk}b^{k}\Big\}_{m=0,1,2,3}
\end{equation}
Then by \er{hoyuiouigyfghgjh3478zzrrZZffGGjkkj},
\er{hoyuiouigyfg3478zzrrZZffGGhhjhj} and
\er{fgjfjhgghhgjghjhjkkkkgjghghuiiiulkkjKK} we have:
\begin{equation}\label{fgjfjhgghhgjghjhjkkkkgjghghuiiiulkkjKKyuyyu}
a^0=a_{0}+\sum_{k=1}^{3}\frac{1}{c}v^k a_{k}\quad\text{and}\quad
a^m=-a_{m}+\frac{1}{c}\bigg(a_{0}+\sum_{k=1}^{3}\frac{1}{c}v^k
a_{k}\bigg)v^m\quad\quad\forall m=1,2,3,
\end{equation}
and
\begin{multline}\label{fgjfjhgghhgjghjhjkkkkgjghghuiiiulkkjKKyuyyuuuui}
b_0=\left(1-\frac{|\vec
v|^2}{c^2}\right)b^{0}+\sum_{k=1}^{3}\frac{1}{c}v^k
b^{k}=b^0-\sum_{k=1}^{3}\frac{1}{c}v^k\left(-
b^{k}+\frac{1}{c}b^0v^k\right)\\ \quad\text{and}\quad
b_m=-b^{m}+\frac{1}{c} b^{0}v^m\quad\quad\forall m=1,2,3.
\end{multline}
We also can rewrite \er{fgjfjhgghhgjghjhjkkkkgjghghuiiiulkkjKKyuyyu}
and \er{fgjfjhgghhgjghjhjkkkkgjghghuiiiulkkjKKyuyyuuuui} as:
\begin{equation}\label{fgjfjhgghhgjghjhjkkkkgjghghuiiiulkkjKKyuyyuioui}
a^0=a_{0}+\sum_{k=1}^{3}\frac{1}{c}v^k a_{k}\quad\text{and}\quad
a^m=-a_{m}+\frac{1}{c}a^0v^m\quad\quad\forall m=1,2,3,
\end{equation}
and
\begin{equation}\label{fgjfjhgghhgjghjhjkkkkgjghghuiiiulkkjKKyuyyuuuuiiuii}
b_m=-b^{m}+\frac{1}{c} b^{0}v^m \quad\text{and}\quad
b_0=b^0-\sum_{k=1}^{3}\frac{1}{c}v^kb_k \quad\quad\forall m=1,2,3.
\end{equation}
In particular, if we consider the scalar field $\Lambda$ on the
group $\mathcal{S}_0$ defined by:
\begin{equation}\label{fgjfjhgghhgjghjhjkkkkgjghghuiiiulkkjKKyuyyu0io}
\Lambda:=b^0a_0+\sum_{k=1}^{3}b^ka_k
\end{equation}
then by inserting
\er{fgjfjhgghhgjghjhjkkkkgjghghuiiiulkkjKKyuyyuioui} and
\er{fgjfjhgghhgjghjhjkkkkgjghghuiiiulkkjKKyuyyuuuuiiuii} into
\er{fgjfjhgghhgjghjhjkkkkgjghghuiiiulkkjKKyuyyu0io} we deduce:
\begin{equation}\label{fgjfjhgghhgjghjhjkkkkgjghghuiiiulkkjKKyuyyu0ioioio}
\Lambda=b^0\left(a^0-\sum_{k=1}^{3}\frac{1}{c}v^ka_k\right)+\sum_{k=1}^{3}\left(-b_{k}+\frac{1}{c}b^0v^k\right)a_k=b^0a^0-\sum_{k=1}^{3}b_{k}a_k.
\end{equation}
So,
\begin{equation}\label{fgjfjhgghhgjghjhjkkkkgjghghuiiiulkkjKKyuyyu0ioioiogghghgh}
\Lambda=b^0a_0+\sum_{k=1}^{3}b^ka_k=b^0a^0-\sum_{k=1}^{3}b_{k}a_k.
\end{equation}

Next, if for every speed-like vector field $\vec u$ we consider the
four-vector field $(u^0,u^1,u^2,u^3)$ defined by
\er{fgjfjhgghhgjghjhjijhojihjhjjijhjjjjjuii} as:
\begin{equation}\label{fgjfjhgghhgjghjhjijhojihjhjjijhjjjjjuiikkj}
(u^0,u^1,u^2,u^3):=\left(1,\frac{1}{c}\vec
u\right)\quad\text{where}\quad
u^0=1\;\;\text{and}\;\;(u^1,u^2,u^3)=\frac{1}{c}\vec
u\in\mathbb{R}^3,
\end{equation}
then, by \er{fgjfjhgghhgjghjhjkkkkgjghghuiiiulkkjKKyuyyuuuuiiuii}
the corresponding lowered four-covector field $(u_0,u_1,u_2,u_3)$
satisfies:
\begin{multline}\label{fgjfjhgghhgjghjhjijhojihjhjjijhjjjjjuiikkjjnj}
(u_0,u_1,u_2,u_3):=\left(1+\frac{1}{c^2}\left(\vec u-\vec
v\right)\cdot\vec v,-\frac{1}{c}\left(\vec u-\vec
v\right)\right)\quad\text{where}\quad\\
u_0=1+\frac{1}{c^2}\left(\vec u-\vec v\right)\cdot\vec
v\;\;\text{and}\;\;(u_1,u_2,u_3)=-\frac{1}{c}\left(\vec u-\vec
v\right)\in\mathbb{R}^3.
\end{multline}
Moreover, in the case where $(u^0,u^1,u^2,u^3)$  is a
four-dimensional speed, we call the corresponding lowered
four-covector field $(u_0,u_1,u_2,u_3)$ by the name four-dimensional
cospeed. In particular, if we consider the four-dimensional
gravitational potential $(v^0,v^1,v^2,v^3)$ defined by
\er{fgjfjhgghhgjghjhjijhojihjhjjijhjjjjjuiijjjk}:
\begin{equation}\label{fgjfjhgghhgjghjhjijhojihjhjjijhjjjjjuiijjjkhjhjhj}
(v^0,v^1,v^2,v^3):=\left(1,\frac{1}{c}\vec
v\right)\quad\text{where}\quad
v^0=1\;\;\text{and}\;\;(v^1,v^2,v^3)=\frac{1}{c}\vec v,
\end{equation}
then by \er{fgjfjhgghhgjghjhjijhojihjhjjijhjjjjjuiikkjjnj} we obtain
that the corresponding lowered four-covector field
$(v_0,v_1,v_2,v_3)$, that we call the four-covector of gravitational
potential, satisfies:
\begin{equation}\label{fgjfjhgghhgjghjhjijhojihjhjjijhjjjjjuiikkjjnj1}
(v_0,v_1,v_2,v_3):=\left(1,0,0,0\right)\quad\text{where}\quad
v_0=1\;\;\text{and}\;\;(v_1,v_2,v_3)=0:=(0,0,0).
\end{equation}
Note that the four-covector of gravitational potential, defined by
\er{fgjfjhgghhgjghjhjijhojihjhjjijhjjjjjuiikkjjnj1} coincides with
the four-covector defined by
\er{fgjfjhgghhgjghjhjkkkkgjghghuiiiulkkjlkklplikklkjljghhgghk} as
the gradient of the scalar of global time. Next, by
\er{fgjfjhgghhgjghjhjijhojihjhjjijhjjjjjuiijjjkhjhjhj} and
\er{fgjfjhgghhgjghjhjijhojihjhjjijhjjjjjuiikkjjnj1} we clearly have:
\begin{equation}\label{fgjfjhgghhgjghjhjijhojihjhjjijhjjjjjuiijjjkhjhjhjuiiuuuyu}
c^2\left(\sum_{j=0}^{3}\sum_{k=0}^{3}\,g^{jk}\frac{\partial\tau}{\partial
x^j}\frac{\partial\tau}{\partial
x^k}\right)=\sum_{j=0}^{3}\sum_{k=0}^{3}g^{jk}v_jv_k=\sum_{j=0}^{3}\sum_{k=0}^{3}g_{jk}v^jv^k=\sum_{j=0}^{3}v^jv_j=1,
\end{equation}
where $\tau$ is the scalar of the global time on the group
$\mathcal{S}_0$, defined by \er{uguyytfddddgghkjhhjuuiui}. Finally,
we clearly have
\begin{equation}\label{fgjfjhfgfgfgfh}
\sum_{k=0}^{3}{\Theta}^{mk}\frac{\partial\tau}{\partial
x^k}=\sum_{k=0}^{3}{\Theta}^{mk}v_k=0\quad\quad\forall\, m=0,1,2,3,
\end{equation}
where $\Theta^{ij}$ is the contravariant tensor of the
three-dimensional geometry, defined by
\er{huohuioy89gjjhjffffff3478zzrrZZZhjhhjhhjjhhffGGhjjhuiiuihhjkkoioi}.

More generally, if for every speed-like vector field $\vec u$ and
every proper scalar field $\sigma$  we consider the four-vector
field $(b^0,b^1,b^2,b^3)$ on the group $\mathcal{S}_0$ defined by
\er{fgjfjhgghhgjghjhjijhojihjhjjijhjjjjjuiijjjkl} as:
\begin{equation}\label{fgjfjhgghhgjghjhjijhojihjhjjijhjjjjjuiijjjklhhh}
(b^0,b^1,b^2,b^3):=\left(\sigma,\frac{\sigma}{c}\vec
u\right)\quad\text{where}\quad
b^0=\sigma\;\;\text{and}\;\;(b^1,b^2,b^3)=\frac{\sigma}{c}\vec u,
\end{equation}
then by \er{fgjfjhgghhgjghjhjkkkkgjghghuiiiulkkjKKyuyyuuuuiiuii} the
corresponding lowered four-covector field $(b_0,b_1,b_2,b_3)$
satisfies:
\begin{multline}\label{fgjfjhgghhgjghjhjijhojihjhjjijhjjjjjuiikkjjnjjbj}
(b_0,b_1,b_2,b_3):=\left(\sigma\left(1+\frac{1}{c^2}\left(\vec
u-\vec v\right)\cdot\vec v\right),-\frac{\sigma}{c}\left(\vec u-\vec
v\right)\right)\quad\text{where}\quad\\
b_0=\sigma\left(1+\frac{1}{c^2}\left(\vec u-\vec v\right)\cdot\vec
v\right)\;\;\text{and}\;\;(b_1,b_2,b_3)=-\frac{\sigma}{c}\left(\vec
u-\vec v\right).
\end{multline}
In particular, if we consider the field of four-vector of the moment
of a particle $(p^0,p^1,p^2,p^3)$ defined by
\er{fgjfjhgghhgjghjhjijhojihjhjjijhjjjjjuiijjjklihh} as
\begin{equation}\label{fgjfjhgghhgjghjhjijhojihjhjjijhjjjjjuiijjjklihhjjj}
(p^0,p^1,p^2,p^3):=\left(m,\frac{1}{c}(m\vec
u)\right)\quad\text{where}\quad
p^0=m\;\;\text{and}\;\;(p^1,p^2,p^3)=\frac{1}{c}(m\vec u),
\end{equation}
where $m$ is the mass of the particle and $\vec u$ is the velocity
of the particle, then the corresponding lowered four-covector field
$(p_0,p_1,p_2,p_3)$, which we call the four-covector of momentum,
satisfies:
\begin{multline}\label{fgjfjhgghhgjghjhjijhojihjhjjijhjjjjjuiikkjjnjjbjhjh}
(p_0,p_1,p_2,p_3):=\left(m\left(1+\frac{1}{c^2}\left(\vec u-\vec
v\right)\cdot\vec v\right),-\frac{m}{c}\left(\vec u-\vec
v\right)\right)\quad\text{where}\quad\\
p_0=m\left(1+\frac{1}{c^2}\left(\vec u-\vec v\right)\cdot\vec
v\right)\;\;\text{and}\;\;(p_1,p_2,p_3)=-\frac{m}{c}\left(\vec
u-\vec v\right).
\end{multline}
In particular, the scalar field $J_0$ defined by
\begin{equation}\label{fgjfjhgghhgjghjhjkkkkgjghghuiiiulkkjKKyuyyu0ioioiogghghghgghghhg}
J_0:=-\frac{c^2}{2m}\left(p^0p_0+\sum_{k=1}^{3}p^kp_k\right),
\end{equation}
by \er{fgjfjhgghhgjghjhjkkkkgjghghuiiiulkkjKKyuyyu0ioioiogghghgh},
\er{fgjfjhgghhgjghjhjijhojihjhjjijhjjjjjuiijjjklihhjjj} and
\er{fgjfjhgghhgjghjhjijhojihjhjjijhjjjjjuiikkjjnjjbjhjh} satisfies:
\begin{equation}\label{fgjfjhgghhgjghjhjkkkkgjghghuiiiulkkjKKyuyyu0ioioiogghghghgghgh}
J_0=\frac{mc^2}{2}\left(\frac{1}{c^2}\left|\vec u-\vec
v\right|^2-1\right)=\frac{m}{2}\left|\vec u-\vec
v\right|^2-\frac{mc^2}{2}.
\end{equation}
Moreover, if we consider the  four-dimensional electric current
$(j^0,j^1,j^2,j^3)$ defined by
\er{fgjfjhgghhgjghjhjijhojihjhjjijhjjjjjuiijjjklihhojjjo} as
\begin{equation}\label{fgjfjhgghhgjghjhjijhojihjhjjijhjjjjjuiijjjklihhojjjokjkj}
(j^0,j^1,j^2,j^3):=\left(\rho,\frac{1}{c}\,\vec
j\right)\quad\text{where}\quad
j^0=\rho\;\;\text{and}\;\;(j^1,j^2,j^3)=\frac{1}{c}\,\vec j,
\end{equation}
where $\rho$ is the electric charge density and $\vec j$ is the
electric current density, then the corresponding lowered
four-covector field $(j_0,j_1,j_2,j_3)$, which we call the
four-covector of current, satisfies:
\begin{multline}\label{fgjfjhgghhgjghjhjijhojihjhjjijhjjjjjuiikkjjnjjbjhjhiyuyug}
(j_0,j_1,j_2,j_3):=\left(\rho+\frac{1}{c^2}\left(\vec j-\rho\vec
v\right)\cdot\vec v,-\frac{1}{c}\left(\vec j-\rho\vec
v\right)\right)\quad\text{where}\quad\\
j_0=\rho+\frac{1}{c^2}\left(\vec j-\rho\vec v\right)\cdot\vec
v\;\;\text{and}\;\;(j_1,j_2,j_3)=-\frac{1}{c}\left(\vec j-\rho\vec
v\right).
\end{multline}
Finally, if $\Psi$ is the scalar electromagnetic potential and $\vec
A$ is the vector electromagnetic potential and we consider the
four-\underline{covector} field of four dimensional electromagnetic
potential $(A_0,A_1,A_2,A_3)$, defined by
\er{fgjfjhgghhgjghjhjijhojihjhjjijhjjjljljpk} as:
\begin{equation}\label{fgjfjhgghhgjghjhjijhojihjhjjijhjjjljljpkikhj}
(A_0,A_1,A_2,A_3)=(\Psi,-\vec A)\quad\text{where}\quad
A_0=\Psi\;\;\text{and}\;\;(A_1,A_2,A_3)=-\vec A,
\end{equation}
then by inserting \er{fgjfjhgghhgjghjhjijhojihjhjjijhjjjljljpkikhj}
into \er{fgjfjhgghhgjghjhjkkkkgjghghuiiiulkkjKKyuyyuioui} we deduce
that the corresponding lifted four-vector field $(A^0,A^1,A^2,A^3)$,
which we call the four-vector of electromagnetic potential,
satisfies:
\begin{equation}\label{fgjfjhgghhgjghjhjkkkkgjghghuiiiulkkjKKyuyyuiouigjguuuiui}
A^0=\Psi-\frac{1}{c}\vec v\cdot\vec
A\quad\text{and}\quad(A^1,A^2,A^3)=\vec
A+\frac{1}{c}\left(\Psi-\frac{1}{c}\vec v\cdot\vec A\right)\vec v.
\end{equation}
On the other hand, the proper scalar electromagnetic potential
$\Psi_0$ was defined by
\er{vhfffngghhjghhgPPNghghghutghffugghjhjkjjkl} as:
\begin{equation}\label{vhfffngghhjghhgPPNghghghutghffugghjhjkjjklhjgh}
\Psi_0:=\Psi-\frac{1}{c}\vec A\cdot\vec v.
\end{equation}
Thus we rewrite
\er{fgjfjhgghhgjghjhjkkkkgjghghuiiiulkkjKKyuyyuiouigjguuuiui} as:
\begin{equation}\label{fgjfjhgghhgjghjhjkkkkgjghghuiiiulkkjKKyuyyuiouigjguuuiui1}
A^0=\Psi_0\quad\text{and}\quad(A^1,A^2,A^3)=\vec
A+\frac{1}{c}\Psi_0\vec v.
\end{equation}

Next given a two times covariant tensor $\{c_{mn}\}_{m,n=0,1,2,3}$
on the group $\mathcal{S}_0$ by
\er{fgjfjhgghhgjghjhjkkkkgjghghuiiiulkkjlkkl} we consider two times
contravariant tensor
on $\mathcal{S}_0$:
$\{c^{mn}\}_{m,n=0,1,2,3}$
defined
by:
\begin{equation}\label{fgjfjhgghhgjghjhjkkkkgjghghuiiiulkkjlkklKK}
c^{mn}:=\sum_{k=0}^{3}\sum_{j=0}^{3}g^{mj}g^{nk}c_{jk}
\quad\quad\forall\,
m,n=0,1,2,3.
\end{equation}
We rewrite \er{fgjfjhgghhgjghjhjkkkkgjghghuiiiulkkjlkklKK} as:
\begin{multline}\label{khjhhkfgjfjhgghhgjghjhjkkkkgjghghuiiiulkkjlkklKKgfg}
c^{mn}=g^{m0}g^{n0}c_{00}+\sum_{k=1}^{3}g^{m0}g^{nk}c_{0k}+\sum_{j=1}^{3}g^{mj}g^{n0}c_{j0}+\sum_{k=1}^{3}\sum_{j=1}^{3}g^{mj}g^{nk}c_{jk}
\quad\forall\,
m,n=0,1,2,3.
\end{multline}
In particular, by inserting \er{hoyuiouigyfghgjh3478zzrrZZffGGjkkj}
and \er{hoyuiouigyfg3478zzrrZZffGGhhjhj} into
\er{khjhhkfgjfjhgghhgjghjhjkkkkgjghghuiiiulkkjlkklKKgfg} we deduce:
\begin{equation}\label{khjhhkfgjfjhgghhgjghjhjkkkkgjghghuiiiulkkjlkklKKgfgjhjjghgjh}
\begin{cases}
c^{00}=c_{00}+\sum_{k=1}^{3}\frac{v^k}{c}c_{0k}+\sum_{j=1}^{3}\frac{v^j}{c}c_{j0}+\sum_{k=1}^{3}\sum_{j=1}^{3}\frac{v^j}{c}\frac{v^k}{c}c_{jk}
\\
c^{m0}=
\frac{v^m}{c}c^{00}-c_{m0}-\sum_{k=1}^{3}\frac{v^k}{c}c_{mk}
\quad\forall\, m=1,2,3,
\\
c^{0n}=\frac{v^n}{c}c^{00}
-c_{0n} -\sum_{j=1}^{3}\frac{v^j}{c}c_{jn} \quad\forall\, n=1,2,3,
\\
c^{mn}=\frac{v^m}{c}\frac{v^n}{c}c^{00}
-\sum_{k=1}^{3}\frac{v^n}{c}\frac{v^k}{c}c_{mk}-\sum_{j=1}^{3}\frac{v_m}{c}\frac{v^j}{c}c_{jn}
-\frac{v^m}{c}c_{0n}-\frac{v^n}{c}c_{m0}+c_{mn}\quad\forall\,
m,n=1,2,3.
\end{cases}
\end{equation}
We rewrite
\er{khjhhkfgjfjhgghhgjghjhjkkkkgjghghuiiiulkkjlkklKKgfgjhjjghgjh}
as:
\begin{equation}\label{khjhhkfgjfjhgghhgjghjhjkkkkgjghghuiiiulkkjlkklKKgfgjhjjghgjh1}
\begin{cases}
c^{00}=c_{00}+\sum_{k=1}^{3}\frac{v^k}{c}c_{0k}+\sum_{j=1}^{3}\frac{v^j}{c}c_{j0}+\sum_{k=1}^{3}\sum_{j=1}^{3}\frac{v^j}{c}\frac{v^k}{c}c_{jk}
\\
c^{m0}=
\frac{v^m}{c}c^{00}-c_{m0}-\sum_{k=1}^{3}\frac{v^k}{c}c_{mk}
\quad\forall\, m=1,2,3,
\\
c^{0n}=\frac{v^n}{c}c^{00}
-c_{0n} -\sum_{j=1}^{3}\frac{v^j}{c}c_{jn} \quad\forall\, n=1,2,3,
\\
c^{mn}=\frac{v^m}{c}c^{0n}+\frac{v^n}{c}c^{m0}-\frac{v^m}{c}\frac{v^n}{c}c^{00}
+c_{mn}\quad\forall\,
m,n=1,2,3.
\end{cases}
\end{equation}
In particular if the tensor $\{c_{mn}\}_{m,n=0,1,2,3}$ is
antisymmetric, i.e. $c_{mn}=-c_{nm}\;\forall \,m,n=0,1,2,3$, then we
simplify
\er{khjhhkfgjfjhgghhgjghjhjkkkkgjghghuiiiulkkjlkklKKgfgjhjjghgjh1}
as
\begin{equation}\label{khjhhkfgjfjhgghhgjghjhjkkkkgjghghuiiiulkkjlkklKKgfgjhjjghgjhhjhj}
\begin{cases}
c^{00}=0
\\
c^{mm}=0 \quad\forall\, m=1,2,3,
\\
c^{0m}=-c^{m0}=
-c_{0m}+\sum_{k=1}^{3}\frac{v^k}{c}c_{mk} \quad\forall\, m=1,2,3,
\\
c^{mn}=\frac{v^m}{c}c^{0n}-\frac{v^n}{c}c^{0m}
+c_{mn}\quad\forall\, m,n=1,2,3.
\end{cases}
\end{equation}
In particular, if $\{F_{ij}\}_{0\leq i,j\leq 3}$ is the
antisymmetric two times covariant tensor field of the
electromagnetic field on the group $\mathcal{S}_0$, which by
\er{huohuioy89gjjhjffffff3478zzrrZZZhjhhjhhjjhhffGGhjjhiuuijkjjk}
satisfies:
\begin{equation}\label{huohuioy89gjjhjffffff3478zzrrZZZhjhhjhhjjhhffGGhjjhiuuijkjjkihjhjh}
\begin{cases}
F_{00}=0\\ F_{0j}=-F_{j0}=E_j\quad\quad\forall\, j=1,2,3\\
F_{jj}=0\quad\quad\forall\, j=1,2,3
\\
F_{12}=-F_{21}=-B_3
\\
F_{13}=-F_{31}=B_2
\\
F_{23}=-F_{32}=-B_1\,,
\end{cases}
\end{equation}
where  $\vec E:=(E_1,E_2,E_3)$ and $\vec B:=(B_1,B_2,B_3)$, then by
inserting
\er{huohuioy89gjjhjffffff3478zzrrZZZhjhhjhhjjhhffGGhjjhiuuijkjjkihjhjh}
into
\er{khjhhkfgjfjhgghhgjghjhjkkkkgjghghuiiiulkkjlkklKKgfgjhjjghgjhhjhj}
we deduce:
\begin{equation}\label{khjhhkfgjfjhgghhgjghjhjkkkkgjghghuiiiulkkjlkklKKgfgjhjjghgjhhjhjhhhhh}
\begin{cases}
F^{00}=0
\\
F^{jj}=0 \quad\forall\, j=1,2,3,
\\
F^{01}=-F^{10}=-F_{01}+\frac{v^2}{c}F_{12}+\frac{v^3}{c}F_{13}=-\left(E_1+\frac{1}{c}\left(v^2B_3-v^3B_2\right)\right)\\
F^{02}=-F^{20}=-F_{02}+\frac{v^1}{c}F_{21}+\frac{v^3}{c}F_{23}=-\left(E_2+\frac{1}{c}\left(v^3B_1-v^1B_3\right)\right)\\
F^{03}=-F^{30}=-F_{03}+\frac{v^1}{c}F_{31}+\frac{v^2}{c}F_{32}=-\left(E_3+\frac{1}{c}\left(v^1B_2-v^2B_1\right)\right)
\\

F^{12}=-F^{21}=\frac{v^1}{c}F^{02}-\frac{v^2}{c}F^{01}+F_{12}=-\left(B_3+\frac{1}{c}\left(v^1F^{20}-v^2F^{10}\right)\right)\\
F^{13}=-F^{31}=\frac{v^1}{c}F^{03}-\frac{v^3}{c}F^{01}+F_{13}=B_2+\frac{1}{c}\left(v^3F^{10}-v^1F^{30}\right)\\
F^{23}=-F^{32}=\frac{v^2}{c}F^{03}-\frac{v^3}{c}F^{02}+F_{23}=-\left(B_1+\frac{1}{c}\left(v^2F^{30}-v^3F^{20}\right)\right).
\end{cases}
\end{equation}
Thus, as before in \er{MaxVacFullPPN}, denoting:
\begin{equation}\label{MaxVacFullPPNhjjgh}
\begin{cases}
\vec D:=\vec E+\frac{1}{c}\,\vec v\times
\vec B\\
\vec H:=\vec B+\frac{1}{c}\,\vec v\times \vec D,
\end{cases}
\end{equation}
and denoting $\vec D:=(D_1,D_2,D_3)$ and $\vec H:=(H_1,H_2,H_3)$ we
rewrite
\er{khjhhkfgjfjhgghhgjghjhjkkkkgjghghuiiiulkkjlkklKKgfgjhjjghgjhhjhjhhhhh}
as:
\begin{equation}\label{khjhhkfgjfjhgghhgjghjhjkkkkgjghghuiiiulkkjlkklKKgfgjhjjghgjhhjhjhhhhhghhg}
\begin{cases}
F^{00}=0
\\
F^{0j}=-F^{j0}=-D_j\quad\forall\, j=1,2,3,\\
F^{jj}=0 \quad\forall\, j=1,2,3,\\
F^{12}=-F^{21}=-H_3\\
F^{13}=-F^{31}=H_2\\
F^{23}=-F^{32}=-H_1.
\end{cases}
\end{equation}
In particular, by
\er{huohuioy89gjjhjffffff3478zzrrZZZhjhhjhhjjhhffGGhjjhiuuijkjjkihjhjh}
and
\er{khjhhkfgjfjhgghhgjghjhjkkkkgjghghuiiiulkkjlkklKKgfgjhjjghgjhhjhjhhhhhghhg},
using \er{MaxVacFullPPNhjjgh} we deduce that the scalar field on the
group $\mathcal{S}_0$: $L_e$, defined as:
\begin{equation}\label{MaxVacFullPPNhjjghjjkjhh}
L_e:=\sum_{j=0}^{3}\sum_{k=0}^{3}F^{jk}F_{jk},
\end{equation}
satisfies
\begin{multline}\label{MaxVacFullPPNhjjghjjkjhh1}
L_e=F^{00}F_{00}+\sum_{k=1}^{3}F^{0k}F_{0k}+\sum_{j=1}^{3}F^{j0}F_{j0}+\sum_{j=1}^{3}\sum_{k=1}^{3}F^{jk}F_{jk}=-2\vec
E\cdot\vec D+2\vec B\cdot\vec H=\\-2\left(\left(\vec
D-\frac{1}{c}\,\vec v\times \vec B\right)\cdot\vec D-\vec
B\cdot\left(\vec B+\frac{1}{c}\,\vec v\times \vec
D\right)\right)=-2\left(|\vec D|^2-|\vec B|^2\right).
\end{multline}
\subsection{Maxwell equations in covariant formulation}
It is well known from Tensor Analysys that if $\{S^{ij}\}_{0\leq
i,j\leq 3}$ is the antisymmetric two times contravariant tensor and
if $\{\xi_{ij}\}_{0\leq i,j\leq 3}$ is a symmetric two times
covariant and non-degenerate tensor, both on the certain group
$\mathcal{S}$, then the four-component field $\{\theta_{k}\}_{0\leq
k\leq 3}$ defined by
\begin{equation}\label{MaxVacFullPPNhjjghjjkjhhoujii}
\theta_k:=\sum_{j=0}^{3}\frac{\partial S^{kj}}{\partial
x^j}+\sum_{j=0}^{3}\frac{S^{kj}}{\sqrt{|\text{det}\,\xi|}}\frac{\partial}{\partial
x^j}\left(\sqrt{|\text{det}\,\xi|}\right)\quad\quad\forall\,
k=0,1,2,3,
\end{equation}
is a four-vector on $\mathcal{S}$. Here $\xi$ is a $4\times
4$-matrix defined by:
\begin{equation}\label{fgjfjhgghhgjghjhjkkkkgjghghuiiiuujhjhjkljjhjhjji}
\xi=\{\xi_{ij}\}_{0\leq i,j\leq 3}.
\end{equation}

In particular, if we consider the $4\times 4$-matrix $G$ defined by
\er{fgjfjhgghhgjghjhjkkkkgjghghuiiiuujhjhjklj} as:
\begin{equation}\label{fgjfjhgghhgjghjhjkkkkgjghghuiiiuujhjhjklj1}
G=\{g_{ij}\}_{0\leq i,j\leq 3},
\end{equation}
that satisfies \er{fgjfjhgghhgjghjhjkkkkgjghghuiiiuujhjh1} in every
cartesian coordinate system, i.e.
\begin{equation}\label{fgjfjhgghhgjghjhjkkkkgjghghuiiiuujhjhhjjg}
\text{det}\,G=-1.
\end{equation}
then for the lifted contravariant tensor of the electromagnetic
field $\{F^{ij}\}_{0\leq i,j\leq 3}$ on the group $\mathcal{S}_0$,
considered in
\er{khjhhkfgjfjhgghhgjghjhjkkkkgjghghuiiiulkkjlkklKKgfgjhjjghgjhhjhjhhhhhghhg},
as in \er{MaxVacFullPPNhjjghjjkjhhoujii} we can define the
four-vector field:
\begin{equation}\label{MaxVacFullPPNhjjghjjkjhhoujiiikjjihjhiuiu}
\left\{\sum_{j=0}^{3}\frac{\partial F^{kj}}{\partial
x^j}+\sum_{j=0}^{3}\frac{F^{kj}}{\sqrt{|\text{det}\,G|}}\frac{\partial}{\partial
x^j}\left(\sqrt{|\text{det}\,G|}\right)\right\}_{0\leq k\leq
3}=\left\{\sum_{j=0}^{3}\frac{\partial F^{kj}}{\partial
x^j}\right\}_{0\leq k\leq 3}
\end{equation}
on the group $\mathcal{S}_0$. Note here that we denoted the matrix
$G=\{g_{ij}\}_{0\leq i,j\leq 3}$ by the same letter as the
Gravitational Constant $G$. However, there is no ambiguity, since in
the second case $G$ is a constant scalar and in the first case $G$
is a matrix. Moreover, we will use the matrix notation
$G=\{g_{ij}\}_{0\leq i,j\leq 3}$ only in the expressions containing
term $\text{det}\,G$. Then by
\er{khjhhkfgjfjhgghhgjghjhjkkkkgjghghuiiiulkkjlkklKKgfgjhjjghgjhhjhjhhhhhghhg},
denoting
$$(x^0,x^1,x^2,x^3):=(ct,x_1,x_2,x_3)=(ct,\vec x),$$
we deduce:
\begin{equation}\label{khjhhkfgjfjhgghhgjghjhjkkkkgjghghuiiiulkkjlkklKKgfgjhjjghgjhhjhjhhhhhghhgtyytojjj}
\begin{cases}
\sum_{j=0}^{3}\frac{\partial F^{0j}}{\partial x^j}=-div_{\vec x}\vec D\\
\sum_{j=0}^{3}\frac{\partial F^{1j}}{\partial x^j}=\frac{1}{c}\frac{\partial D_1}{\partial t}-\left(\frac{\partial H_3}{\partial x_2}-\frac{\partial H_2}{\partial x_3}\right)\\
\sum_{j=0}^{3}\frac{\partial F^{2j}}{\partial x^j}=\frac{1}{c}\frac{\partial D_2}{\partial t}-\left(\frac{\partial H_1}{\partial x_3}-\frac{\partial H_3}{\partial x_1}\right)\\
\sum_{j=0}^{3}\frac{\partial F^{3j}}{\partial
x^j}=\frac{1}{c}\frac{\partial D_3}{\partial t}-\left(\frac{\partial
H_2}{\partial x_1}-\frac{\partial H_1}{\partial x_2}\right).
\end{cases}
\end{equation}
I.e.:
\begin{equation}\label{khjhhkfgjfjhgghhgjghjhjkkkkgjghghuiiiulkkjlkklKKgfgjhjjghgjhhjhjhhhhhghhgtyytojjjjj}
\left(\sum_{j=0}^{3}\frac{\partial F^{0j}}{\partial
x^j},\sum_{j=0}^{3}\frac{\partial F^{1j}}{\partial
x^j},\sum_{j=0}^{3}\frac{\partial F^{2j}}{\partial
x^j},\sum_{j=0}^{3}\frac{\partial F^{3j}}{\partial
x^j}\right)=\left(-div_{\vec x}\vec D,
\left(\frac{1}{c}\frac{\partial \vec D}{\partial t}-curl_{\vec
x}\vec H\right)\right).
\end{equation}
Therefore, by
\er{khjhhkfgjfjhgghhgjghjhjkkkkgjghghuiiiulkkjlkklKKgfgjhjjghgjhhjhjhhhhhghhgtyytojjjjj},
the first pair of Maxwell Equations in \er{MaxVacFullPPN}:
\begin{equation}\label{MaxVacFullPPNhjjghyghghiyyhhhj}
\begin{cases}
curl_{\vec x} \vec H=\frac{4\pi}{c}\vec j+\frac{1}{c}\frac{\partial \vec D}{\partial t}\\
div_{\vec x} \vec D= 4\pi\rho,
\end{cases}
\end{equation}
is equivalent to the following equations:
\begin{equation}\label{khjhhkfgjfjhgghhgjghjhjkkkkgjghghuiiiulkkjlkklKKgfgjhjjghgjhhjhjhhhhhghhgtyytojjjjjuyiy}
\left(\sum_{j=0}^{3}\frac{\partial F^{0j}}{\partial
x^j},\sum_{j=0}^{3}\frac{\partial F^{1j}}{\partial
x^j},\sum_{j=0}^{3}\frac{\partial F^{2j}}{\partial
x^j},\sum_{j=0}^{3}\frac{\partial F^{3j}}{\partial
x^j}\right)
=-4\pi(j^0,j^1,j^2,j^3),
\end{equation}
where $(j^0,j^1,j^2,j^3)$ is the four-vector of electric current on
the group $\mathcal{S}_0$ defined by
\er{fgjfjhgghhgjghjhjijhojihjhjjijhjjjjjuiijjjklihhojjjo} as:
\begin{equation}
\label{fgjfjhgghhgjghjhjijhojihjhjjijhjjjjjuiijjjklihhojjjoouuoiuiu}
(j^0,j^1,j^2,j^3):=\left(\rho,\frac{1}{c}\,\vec j\right)
\end{equation}
Note that in both sides of equation
\er{khjhhkfgjfjhgghhgjghjhjkkkkgjghghuiiiulkkjlkklKKgfgjhjjghgjhhjhjhhhhhghhgtyytojjjjjuyiy}
we have four-vectors and thus
\er{khjhhkfgjfjhgghhgjghjhjkkkkgjghghuiiiulkkjlkklKKgfgjhjjghgjhhjhjhhhhhghhgtyytojjjjjuyiy}
is a covariant form of \er{MaxVacFullPPNhjjghyghghiyyhhhj}. On the
other hand, the second pair of Maxwell Equations in
\er{MaxVacFullPPN}:
\begin{equation}\label{MaxVacFullPPNhjjghyghghiyy}
\begin{cases}
curl_{\vec x} \vec E+\frac{1}{c}\frac{\partial \vec B}{\partial t}=0\\
div_{\vec x} \vec B=0,
\end{cases}
\end{equation}
is equivalent to
\er{MaxVacFull1bjkgjhjhgjgjgkjfhjfdghghligioiuittrPPNkkk}, i.e. to
the following:
\begin{equation}\label{MaxVacFull1bjkgjhjhgjgjgkjfhjfdghghligioiuittrPPNkkkouiuiuuiui}
\begin{cases}
\vec B= curl_{\vec x} \vec A,\\
\vec E=-\nabla_{\vec x}\Psi-\frac{1}{c}\frac{\partial\vec
A}{\partial t},
\end{cases}
\end{equation}
On the other hand, as before, by
\er{huohuioy89gjjhjffffff3478zzrrZZZhjhhjhhjjhhffGGhjjhiuuijkjjk} we
can rewrite
\er{MaxVacFull1bjkgjhjhgjgjgkjfhjfdghghligioiuittrPPNkkkouiuiuuiui}
in the form of
\er{huohuioy89gjjhjffffff3478zzrrZZZhjhhjhhjjhhffGGhjjh}:
\begin{equation}\label{huohuioy89gjjhjffffff3478zzrrZZZhjhhjhhjjhhffGGhjjhuyuyu}
F_{ij}=\frac{\partial A_j}{\partial x^i}-\frac{\partial
A_i}{\partial x^j}\quad\quad\forall\, i,j=0,1,2,3\,,
\end{equation}
where $(A_0,A_1,A_2,A_3)$ is the four-covector of the
electromagnetic potential on the group $\mathcal{S}_0$ defined by
\er{fgjfjhgghhgjghjhjijhojihjhjjijhjjjljljpk} as:
\begin{equation}\label{fgjfjhgghhgjghjhjijhojihjhjjijhjjjljljpkyuuyyu}
(A_0,A_1,A_2,A_3)=(\Psi,-\vec A).
\end{equation}
Note that in both sides of equation
\er{huohuioy89gjjhjffffff3478zzrrZZZhjhhjhhjjhhffGGhjjhuyuyu} we
have two time covariant tensors, and thus
\er{huohuioy89gjjhjffffff3478zzrrZZZhjhhjhhjjhhffGGhjjhuyuyu} is a
covariant form of \er{MaxVacFullPPNhjjghyghghiyy}. Finally, by
\er{khjhhkfgjfjhgghhgjghjhjkkkkgjghghuiiiulkkjlkklKKgfgjhjjghgjhhjhjhhhhh},
the relations between $(\vec E,\vec B)$ and $(\vec D,\vec H)$ in
\er{MaxVacFullPPNhjjgh}:
\begin{equation}\label{MaxVacFullPPNhjjghojjkjkiou}
\begin{cases}
\vec D=\vec E+\frac{1}{c}\,\vec v\times
\vec B\\
\vec H=\vec B+\frac{1}{c}\,\vec v\times \vec D,
\end{cases}
\end{equation}
are equivalent to the following covariant equations:
\begin{equation}\label{fgjfjhgghhgjghjhjkkkkgjghghuiiiulkkjlkklKKkjkj}
F^{mn}:=\sum_{k=0}^{3}\sum_{j=0}^{3}g^{mj}g^{nk}F_{jk}
\quad\quad\forall\, m,n=0,1,2,3.
\end{equation}
Thus by
\er{huohuioy89gjjhjffffff3478zzrrZZZhjhhjhhjjhhffGGhjjhuyuyu},
\er{fgjfjhgghhgjghjhjkkkkgjghghuiiiulkkjlkklKKkjkj} and
\er{khjhhkfgjfjhgghhgjghjhjkkkkgjghghuiiiulkkjlkklKKgfgjhjjghgjhhjhjhhhhhghhgtyytojjjjjuyiy}
together, we deduce that the full system of Maxwell Equations in
\er{MaxVacFullPPN}:
\begin{equation}\label{MaxVacFullPPNhjjghyghghiyyhh}
\begin{cases}
curl_{\vec x} \vec H=\frac{4\pi}{c}\vec j+\frac{1}{c}\frac{\partial \vec D}{\partial t}\\
div_{\vec x} \vec D=4\pi\rho\\
curl_{\vec x} \vec E+\frac{1}{c}\frac{\partial \vec B}{\partial t}=0\\
div_{\vec x} \vec B=
0\\
\vec E=\vec D-\frac{1}{c}\,\vec v\times
\vec B\\
\vec H=\vec B+\frac{1}{c}\,\vec v\times \vec D,
\end{cases}
\end{equation}
is equivalent to the following covariant equations:
\begin{equation}\label{khjhhkfgjfjhgghhgjghjhjkkkkgjghghuiiiulkkjlkklKKgfgjhjjghgjhhjhjhhhhhghhgtyytojjjjjuyiyuu}
\sum_{j=0}^{3}\frac{\partial}{\partial
x^j}\left(\sum_{m=0}^{3}\sum_{n=0}^{3}g^{km}g^{jn}\left(\frac{\partial
A_n}{\partial x^m}-\frac{\partial A_m}{\partial
x^n}\right)\right)=-4\pi j^k\quad\quad\forall\, k=0,1,2,3.
\end{equation}
Note that equations
\er{khjhhkfgjfjhgghhgjghjhjkkkkgjghghuiiiulkkjlkklKKgfgjhjjghgjhhjhjhhhhhghhgtyytojjjjjuyiyuu}
are fully analogous to the covariant formulation of Maxwell
equations in Special Relativity and the only difference is the
choice of the pseudo-metric tensor $\{g^{ij}\}_{0\leq i,j\leq 3}$
(Note that for the Special Relativity case we also have
$\text{det}\,G=-1$). As for the cases of the General relativity, the
covariant formulation of Maxwell equations is still similar to
\er{khjhhkfgjfjhgghhgjghjhjkkkkgjghghuiiiulkkjlkklKKgfgjhjjghgjhhjhjhhhhhghhgtyytojjjjjuyiyuu},
however, in addition to the different choice of the pseudo-metric
tensor $\{g^{ij}\}_{0\leq i,j\leq 3}$ we also have
$\text{det}\,G\neq Const.$ and thus for the full analogy equations
\er{khjhhkfgjfjhgghhgjghjhjkkkkgjghghuiiiulkkjlkklKKgfgjhjjghgjhhjhjhhhhhghhgtyytojjjjjuyiyuu}
should be rewritten in the enlarged form, due to
\er{MaxVacFullPPNhjjghjjkjhhoujii},\er{MaxVacFullPPNhjjghjjkjhhoujiiikjjihjhiuiu}:
\begin{multline}\label{khjhhkfgjfjhgghhgjghjhjkkkkgjghghuiiiulkkjlkklKKgfgjhjjghgjhhjhjhhhhhghhgtyytojjjjjuyiyuughgfg}
\sum_{j=0}^{3}\frac{\partial}{\partial
x^j}\left(\sum_{m=0}^{3}\sum_{n=0}^{3}g^{km}g^{jn}\left(\frac{\partial
A_n}{\partial x^m}-\frac{\partial A_m}{\partial
x^n}\right)\right)+\\
\sum_{j=0}^{3}\frac{1}{\sqrt{|\text{det}\,G|}}\frac{\partial}{\partial
x^j}\left(\sqrt{|\text{det}\,G|}\right)\left(\sum_{m=0}^{3}\sum_{n=0}^{3}g^{km}g^{jn}\left(\frac{\partial
A_n}{\partial x^m}-\frac{\partial A_m}{\partial x^n}\right)\right)
=-4\pi j^k\quad\quad\forall\, k=0,1,2,3.
\end{multline}
Note also that we can rewrite
\er{khjhhkfgjfjhgghhgjghjhjkkkkgjghghuiiiulkkjlkklKKgfgjhjjghgjhhjhjhhhhhghhgtyytojjjjjuyiyuughgfg}
as:
\begin{equation}\label{khjhhkfgjfjhgghhgjghjhjkkkkgjghghuiiiulkkjlkklKKgfgjhjjghgjhhjhjhhhhhghhgtyytojjjjjuyiyuughgfghhj}
\sum_{j=0}^{3}\frac{\partial}{\partial
x^j}\left(\sum_{m=0}^{3}\sum_{n=0}^{3}\sqrt{|\text{det}\,G|}\,g^{km}g^{jn}\left(\frac{\partial
A_n}{\partial x^m}-\frac{\partial A_m}{\partial x^n}\right)\right)
=-4\pi \sqrt{|\text{det}\,G|}\, j^k\quad\quad\forall\, k=0,1,2,3.
\end{equation}

Next by \er{MaxVacFullPPNhjjghjjkjhh} and
\er{MaxVacFullPPNhjjghjjkjhh1} we have
\begin{equation}\label{MaxVacFullPPNhjjghjjkjhhiuyy}
\frac{1}{2}|\vec D|^2-\frac{1}{2}|\vec
B|^2=-\sum_{j=0}^{3}\sum_{k=0}^{3}\frac{1}{4}F^{jk}F_{jk}.
\end{equation}
Therefore, by
\er{fgjfjhgghhgjghjhjijhojihjhjjijhjjjjjuiijjjklihhojjjoouuoiuiu},
\er{fgjfjhgghhgjghjhjijhojihjhjjijhjjjljljpkyuuyyu} and
\er{MaxVacFullPPNhjjghjjkjhhiuyy}, we can rewrite the density of the
Lagrangian of the electromagnetic field, defined in
\er{vhfffngghkjgghPPNggjgjjkgj} as
\begin{equation}\label{vhfffngghkjgghPPNggjgjjkgjoiui}
L_1\left(\vec A,\Psi,\vec
x,t\right):=\frac{1}{4\pi}\left(\frac{1}{2}\left|\vec
D\right|^2-\frac{1}{2}\left|\vec
B\right|^2-4\pi\left(\rho\Psi-\frac{1}{c}\vec A\cdot\vec
j\right)\right),
\end{equation}
in the equivalent covariant form:
\begin{multline}\label{vhfffngghkjgghPPNggjgjjkgjoiuioiggh}
L_1=\frac{1}{4\pi}\left(-\sum_{n=0}^{3}\sum_{k=0}^{3}\frac{1}{4}F^{nk}F_{nk}-\sum_{k=0}^{3}4\pi
j^k A_k\right)=\\
\frac{1}{4\pi}\left(-\sum_{n=0}^{3}\sum_{k=0}^{3}\sum_{m=0}^{3}\sum_{p=0}^{3}\frac{1}{4}g^{mn}g^{pk}\left(\frac{\partial
A_p}{\partial x^m}-\frac{\partial A_m}{\partial
x^p}\right)\left(\frac{\partial A_k}{\partial x^n}-\frac{\partial
A_n}{\partial x^k}\right)-\sum_{k=0}^{3}4\pi j^k A_k\right).
\end{multline}
The density of Lagrangian in
\er{vhfffngghkjgghPPNggjgjjkgjoiuioiggh} is also fully analogous to
the covariant formulation of the Lagrangian density of the
electromagnetic field in Special and General Relativity and the only
difference is the choice of the pseudo-metric tensor
$\{g^{ij}\}_{0\leq i,j\leq 3}$.

\subsection{Covariant formulation of Lagrangian of motion of a classical
charged particle in the external gravitational and electromagnetic
fields} Given a classical charged particle with inertial mass $m$,
charge $\sigma$, three-dimensional place $\vec r(t)$ and
three-dimensional velocity $\frac{d\vec r}{dt}$ in the outer
gravitational field with three-dimensional vectorial potential $\vec
v(\vec x,t)$, the outer electromagnetical field with
three-dimensional vectorial potential $\vec A(\vec x,t)$ and scalar
potential $\vec \Psi(\vec x,t)$, consider a usual Lagrangian that is
a particular case of \er{vhfffngghkjgghfjjSYSPN}:
\begin{equation}\label{vhfffngghkjgghfjjSYSPNkoijjhpoi}
L_0\left(\frac{d\vec r}{dt},t\right):=
\left\{\frac{m}{2}\left|\frac{d\vec r}{dt}-\vec v(\vec
r,t)\right|^2-\sigma\left(\Psi(\vec r,t)-\frac{1}{c}\vec A(\vec
r,t)\cdot\frac{d\vec r}{dt}\right)\right\}.
\end{equation}
Then, since we are interesting in critical points of the functional
\begin{equation}\label{btfffygtgyggyijhhkkSYSPNuhuygyygyggy}
J_0=\int_0^T L_0\left(\frac{d\vec r}{dt},\vec r,t\right)dt,
\end{equation}
adding a constant does not changes the physical meaning of the
Lagrangian and we can rewrite \er{vhfffngghkjgghfjjSYSPNkoijjhpoi}
as:
\begin{equation}\label{vhfffngghkjgghfjjSYSPNkoijjhpoiuui}
L'_0\left(\frac{d\vec r}{dt},t\right):=
\left\{\left(\frac{m}{2}\left|\frac{d\vec r}{dt}-\vec v(\vec
r,t)\right|^2-\frac{mc^2}{2}\right)-\sigma\left(\Psi(\vec
r,t)-\frac{1}{c}\vec A(\vec r,t)\cdot\frac{d\vec
r}{dt}\right)\right\}.
\end{equation}
and \er{btfffygtgyggyijhhkkSYSPNuhuygyygyggy} as
\begin{multline}\label{btfffygtgyggyijhhkkSYSPNuhuygyygyggyuyy}
J'_0:=J_0-\frac{Tmc^2}{2}=\int_0^T L'_0\left(\frac{d\vec r}{dt},\vec
r,t\right)dt=\\ \int_0^T\left\{\left(\frac{m}{2}\left|\frac{d\vec
r}{dt}-\vec v(\vec
r,t)\right|^2-\frac{mc^2}{2}\right)-\sigma\left(\Psi(\vec
r,t)-\frac{1}{c}\vec A(\vec r,t)\cdot\frac{d\vec
r}{dt}\right)\right\}dt,
\end{multline}
Next consider the four-vector field of the momentum on the group
$\mathcal{S}_0$: $\left(p^0(t),p^1(t),p^2(t),p^3(t)\right)$, defined
by \er{fgjfjhgghhgjghjhjijhojihjhjjijhjjjjjuiijhjhh} and
\er{fgjfjhgghhgjghjhjijhojihjhjjijhjjjjjuiijjjklihh} as:
\begin{equation}\label{fgjfjhgghhgjghjhjijhojihjhjjijhjjjjjuiijhjhhioi}
\left(
p^0(t),p^1(t),p^2(t),p^3(t)\right):=\left(m,\frac{m}{c}\frac{d\vec
r}{dt}(t)\right)=
\left(m,\frac{m}{c}\frac{dr_1}{dt}(t),\frac{m}{c}\frac{dr_2}{dt}(t),\frac{m}{c}\frac{dr_3}{dt}(t)\right)
\end{equation}
Then by
\er{fgjfjhgghhgjghjhjkkkkgjghghuiiiulkkjKKyuyyu0ioioiogghghghgghghhg}
and
\er{fgjfjhgghhgjghjhjkkkkgjghghuiiiulkkjKKyuyyu0ioioiogghghghgghgh}
we have
\begin{multline}\label{fgjfjhgghhgjghjhjkkkkgjghghuiiiulkkjKKyuyyu0ioioiogghghghgghghhghgv}
\frac{mc^2}{2}\left(\frac{1}{c^2}\left|\frac{d\vec r}{dt}-\vec
v(\vec r,t)\right|^2-1\right)=\frac{m}{2}\left|\frac{d\vec
r}{dt}-\vec v(\vec
r,t)\right|^2-\frac{mc^2}{2}\\=-\frac{c^2}{2m}\left(\sum_{k=0}^{3}p^kp_k\right)=-\frac{mc^2}{2}\left(\sum_{j=0}^{3}\sum_{k=0}^{3}g_{jk}(\vec
r,t)\,\frac{p^j}{m}\,\frac{p^k}{m}\right).
\end{multline}
On the other hand if we consider the four-covector of the
electromagnetic potential on the group $\mathcal{S}_0$:
$(A_0,A_1,A_2,A_3)$, defined by
\er{fgjfjhgghhgjghjhjijhojihjhjjijhjjjljljpk} as:
\begin{equation}\label{fgjfjhgghhgjghjhjijhojihjhjjijhjjjljljpkyuuyyuhhhhj}
(A_0,A_1,A_2,A_3)=(\Psi,-\vec A),
\end{equation}
then we can write,
\begin{equation}\label{btfffygtgyggyijhhkkSYSPNuhuygyygyggyuyyioiooiihhhj}
\sigma\left(\Psi(\vec r,t)-\frac{1}{c}\vec A(\vec
r,t)\cdot\frac{d\vec r}{dt}\right)=\sum_{k=0}^{3}\sigma A_k(\vec
r,t)\,\frac{p^k}{m}.
\end{equation}
Thus by
\er{fgjfjhgghhgjghjhjkkkkgjghghuiiiulkkjKKyuyyu0ioioiogghghghgghghhghgv}
and \er{btfffygtgyggyijhhkkSYSPNuhuygyygyggyuyyioiooiihhhj} we
rewrite \er{btfffygtgyggyijhhkkSYSPNuhuygyygyggyuyy} in a covariant
form:
\begin{multline}\label{btfffygtgyggyijhhkkSYSPNuhuygyygyggyuyyuyuy}
J'_0=\int_0^T L'_0\left(\frac{d\vec r}{dt},\vec r,t\right)dt=
\int_0^T\left\{-\frac{mc^2}{2}\left(\sum_{j=0}^{3}\sum_{k=0}^{3}g_{jk}(\vec
r,t)\,\frac{p^j}{m}\,\frac{p^k}{m}\right)-\sum_{k=0}^{3}\sigma
A_k(\vec r,t)\,\frac{p^k}{m}\right\}dt.
\end{multline}
Thus if we consider the four-dimensional space-time trajectory of
the particle:
\begin{equation}\label{btfffjhgjghgh}
\left(\chi^0(t),\chi^1(t),\chi^2(t),\chi^3(t)\right)=\left(t,\frac{1}{c}r_1(t),\frac{1}{c}r_2(t),\frac{1}{c}r_3(t)\right),
\end{equation}
then we rewrite \er{btfffygtgyggyijhhkkSYSPNuhuygyygyggyuyyuyuy} as:
\begin{equation}\label{btfffygtgyggyijhhkkSYSPNuhuygyygyggyuyyuyuykuhghg}
J'_0=
\int_0^T\left\{-\frac{mc^2}{2}\left(\sum_{j=0}^{3}\sum_{k=0}^{3}g_{jk}\left(\chi(t)\right)\,\frac{d\chi^j}{dt}\,\frac{d\chi^k}{dt}\right)-\sum_{k=0}^{3}\sigma
A_k\left(\chi(t)\right)\,\frac{d\chi^k}{dt}\right\}dt.
\end{equation}
Moreover, $\left(\frac{d\chi^0}{dt}, \frac{d\chi^1}{dt},
\frac{d\chi^2}{dt}, \frac{d\chi^3}{dt}\right)$ is a four-vector on
the group $\mathcal{S}_0$ and the global non-relativistic time $t$
is the scalar on the group $\mathcal{S}_0$.

Next we also can consider a more general Lagrangian than
\er{btfffygtgyggyijhhkkSYSPNuhuygyygyggyuyyuyuykuhghg}: given a
function $\mathcal{G}(\tau):\mathbb{R}\to\mathbb{R}$ define:
\begin{equation}\label{btfffygtgyggyijhhkkSYSPNuhuygyygyggyuyyuyuykuhghgjjoj}
J_{\mathcal{G}}(\chi)=
\int_0^T\left\{-mc^2\;\mathcal{G}\left(\sum_{j=0}^{3}\sum_{k=0}^{3}g_{jk}\left(\chi(t)\right)\,\frac{d\chi^j}{dt}\,\frac{d\chi^k}{dt}\right)-\sum_{k=0}^{3}\sigma
A_k\left(\chi(t)\right)\,\frac{d\chi^k}{dt}\right\}dt.
\end{equation}
Clearly, \er{btfffygtgyggyijhhkkSYSPNuhuygyygyggyuyyuyuykuhghgjjoj}
is written in covariant form, and in particular,
\er{btfffygtgyggyijhhkkSYSPNuhuygyygyggyuyyuyuykuhghgjjoj} is
invariant under the change of non-inertial cartesian coordinate
systems. In particular, for $\mathcal{G}(\tau):=\frac{1}{2}\tau$ we
obtain \er{btfffygtgyggyijhhkkSYSPNuhuygyygyggyuyyuyuykuhghg}.

Another important particular case is the following choice:
$\mathcal{G}(\tau):=\sqrt{\tau}$. Then we deduce:
\begin{equation}\label{btfffygtgyggyijhhkkSYSPNuhuygyygyggyuyyuyuykuhghgjjojiyy}
J_{rl}(\chi)=
\int_0^T\left\{-mc^2\;\sqrt{\left(\sum_{j=0}^{3}\sum_{k=0}^{3}g_{jk}\left(\chi(t)\right)\,\frac{d\chi^j}{dt}\,\frac{d\chi^k}{dt}\right)}-\sum_{k=0}^{3}\sigma
A_k\left(\chi(t)\right)\,\frac{d\chi^k}{dt}\right\}dt,
\end{equation}
that is in somewhat analogous to the relativistic Lagrangian of the
motion of charged particle. Due to \er{btfffjhgjghgh} we rewrite
\er{btfffygtgyggyijhhkkSYSPNuhuygyygyggyuyyuyuykuhghgjjojiyy} in a
three-dimensional form as:
\begin{equation}\label{btfffygtgyggyijhhkkSYSPNuhuygyygyggyuyyuyuykuhghgjjojiyyyuyuuyy}
J_{rl}(\vec r)= \int_0^T\left\{
-mc^2\sqrt{1-\frac{1}{c^2}\left|\frac{d\vec r}{dt}-\vec v(\vec
r,t)\right|^2}-\sigma\left(\Psi(\vec r,t)-\frac{1}{c}\vec A(\vec
r,t)\cdot\frac{d\vec r}{dt}\right)\right\}dt.
\end{equation}
Thus in the case
$$\frac{1}{c^2}\left|\frac{d\vec r}{dt}-\vec v(\vec
r,t)\right|^2\ll 1,$$ up to additive constant,
\er{btfffygtgyggyijhhkkSYSPNuhuygyygyggyuyyuyuykuhghgjjojiyyyuyuuyy}
becomes to be \er{btfffygtgyggyijhhkkSYSPNuhuygyygyggy}, where $L_0$
is given by \er{vhfffngghkjgghfjjSYSPNkoijjhpoi}. Note that the
Lagrangian in
\er{btfffygtgyggyijhhkkSYSPNuhuygyygyggyuyyuyuykuhghgjjojiyy} has
the following advantage with respect to
\er{btfffygtgyggyijhhkkSYSPNuhuygyygyggyuyyuyuykuhghg}: if we
parameterize the curve in \er{btfffjhgjghgh} by some arbitrary
parameter $s$ with increasing mapping $t\leftrightarrow s$, which is
however can differ from the global time $t$, then changing variables
of integration in
\er{btfffygtgyggyijhhkkSYSPNuhuygyygyggyuyyuyuykuhghgjjojiyy} from
$t$ to $s$ gives:
\begin{equation}\label{btfffygtgyggyijhhkkSYSPNuhuygyygyggyuyyuyuykuhghgjjojiyyjljlgh}
J_{rl}(\chi)=
\int_a^b\left\{-mc^2\;\sqrt{\left(\sum_{j=0}^{3}\sum_{k=0}^{3}g_{jk}\left(\chi(s)\right)\,\frac{d\chi^j}{ds}\,\frac{d\chi^k}{ds}\right)}-\sum_{k=0}^{3}\sigma
A_k\left(\chi(s)\right)\,\frac{d\chi^k}{ds}\right\}ds,
\end{equation}
that has exactly the same form as
\er{btfffygtgyggyijhhkkSYSPNuhuygyygyggyuyyuyuykuhghgjjojiyy},
however $s$ in
\er{btfffygtgyggyijhhkkSYSPNuhuygyygyggyuyyuyuykuhghgjjojiyyjljlgh}
can be \underline{arbitrary} parameter of the curve with increasing
mapping $t\leftrightarrow s$.

Finally, we would like to note that if the motion of some particle
is ruled by the relativistic-like Lagrangian in
\er{btfffygtgyggyijhhkkSYSPNuhuygyygyggyuyyuyuykuhghgjjojiyyyuyuuyy},
then, although the absolute value of the velocity of the particle
$\left|\frac{d\vec r}{dt}\right|$ can be arbitrary large, the
absolute value of the difference between the velocity of the
particle and the local gravitational potential cannot exceed the
value $c$, i.e.:
\begin{equation}\label{btfffygtgyggyijhhkkSYSPNuhuygyygyggyuyyuyuykuhghgjjojiyyyuyuuyyijyyuyu}
\left|\vec u(t)-\vec v(\vec r,t)\right|:=\left|\frac{d\vec
r}{dt}-\vec v(\vec r,t)\right|\,<\,c\quad\quad\quad\quad\forall\, t,
\end{equation}
provided that
\er{btfffygtgyggyijhhkkSYSPNuhuygyygyggyuyyuyuykuhghgjjojiyyyuyuuyyijyyuyu}
is satisfied in some initial instant of time. Note also that the
quantity in the left hand side of
\er{btfffygtgyggyijhhkkSYSPNuhuygyygyggyuyyuyuykuhghgjjojiyyyuyuuyyijyyuyu}
is invariant under the change of inertial or non-inertial cartesian
coordinate system.

\subsection{Kinematic pseudo-metric tensors of inertia}\label{jjjjjkkk88}
Consider $\{J^{ij}\}_{0\leq i,j\leq 3}$ to be a two times
contravariant tensor field on the group $\mathcal{S}_0$, defined by
\begin{equation}\label{hoyuiouigyfghgjh3478zzrrZZffGGjkkjojj88}
J^{ij}:=k^ik^j-{\Theta}^{ij}\quad\quad\forall\,i,j=0,1,2,3,
\end{equation}
where $\{{\Theta}^{ij}\}_{0\leq i,j\leq 3}$ is the contravariant
tensor of the three-dimensional geometry, defined by
\er{huohuioy89gjjhjffffff3478zzrrZZZhjhhjhhjjhhffGGhjjhuiiuihhjkkoioi}
and being a two times contravariant tensor, and $(k^0,k^1,k^2,k^3)$
is the four-dimensional potential of inertia, defined by
\er{fgjfjhgghhgjghjhjijhojihjhjjijhjjjjjuiijjjkjkkhhkkh} and being a
four-vector. Then by \er{fgjfjhgghhgjghjhjkkkkgjghghjljl} we obtain
that $\{J^{ij}\}_{0\leq i,j\leq 3}$ is indeed a two times
contravariant tensor field on the group $\mathcal{S}_0$ and
moreover, this tensor is symmetric. Moreover, by
\er{huohuioy89gjjhjffffff3478zzrrZZZhjhhjhhjjhhffGGhjjhuiiuihhjkkoioi}
and \er{fgjfjhgghhgjghjhjijhojihjhjjijhjjjjjuiijjjkjkkhhkkh} we
have:
\begin{equation}\label{hoyuiouigyfghgjh3478zzrrZZffGGjkkj88}
\begin{cases}
J^{00}=1\\
J^{ij}=-\delta_{ij}+\frac{k^ik^j}{c^2}\quad\forall 1\leq i,j\leq 3\\
J^{0j}=J^{j0}=\frac{k^j}{c}\quad\forall 1\leq j\leq 3,
\end{cases}
\end{equation}
where $\vec k=(k^1,k^2,k^3)$ is the three-dimensional vectorial
potential of inertia. We call the tensor $\{J^{ij}\}_{0\leq i,j\leq
3}$ the contravariant kinematic pseudo-metric tensor of inertia.
Next consider a $16$-component field $\{J_{ij}\}_{0\leq i,j\leq 3}$
defined by
\begin{equation}\label{hoyuiouigyfg3478zzrrZZffGGhhjhj88}
\begin{cases}
J_{00}=1-\frac{|\vec k|^2}{c^2}\\
J_{ij}=-\delta_{ij}\quad\forall 1\leq i,j\leq 3\\
J_{0j}=J_{j0}=\frac{k^j}{c}\quad\forall 1\leq j\leq 3.
\end{cases}
\end{equation}
Then, as before, we have
\begin{equation*}
\sum_{m=0}^{3}J_{0m}J^{m0}=J_{00}J^{00}+\sum_{m=1}^{3}J_{0m}J^{m0}=
1-\frac{|\vec k|^2}{c^2}+\frac{|\vec k|^2}{c^2}=1,
\end{equation*}
\begin{equation*}
\sum_{m=0}^{3}J_{im}J^{mj}=J_{i0}J^{0j}+\sum_{m=1}^{3}J_{im}J^{mj}=
\frac{k^ik^j}{c^2}+\delta_{ij}-\frac{k^ik^j}{c^2}=\delta_{ij}\quad\forall
1\leq i,j\leq 3,
\end{equation*}
and
\begin{equation*}
\sum_{m=0}^{3}J_{im}J^{m0}=J_{i0}J^{00}+\sum_{m=1}^{3}J_{im}J^{m0}=\frac{k^i}{c}-\frac{k^i}{c}=0
\quad\forall 1\leq i\leq 3,
\end{equation*}
\begin{multline*}
\sum_{m=0}^{3}J_{0m}J^{mj}=J_{00}J^{0j}+\sum_{m=1}^{3}J_{0m}J^{mj}=
\left(1-\frac{|\vec k|^2}{c^2}\right)\frac{k^j}{c}
-\sum_{m=1}^{3}\frac{k^m}{c}\left(\delta_{mj}-\frac{k^mk^j}{c^2}\right)=0
\quad\forall 1\leq j\leq 3,
\end{multline*}
where $\{J^{ij}\}_{0\leq i,j\leq 3}$ is the contravariant kinematic
pseudo-metric tensors of inertia, defined by
\er{hoyuiouigyfghgjh3478zzrrZZffGGjkkj88}. So,
\begin{equation}\label{fgjfjhgghhgjghjhjkkkkgjghghuiiiuujhjh88}
\sum_{m=0}^{3}J^{im}J_{mj}=\begin{cases}
1\quad\text{if}\quad i=j\\
0\quad\text{if}\quad i\neq j
\end{cases}\quad\quad\forall\, i,j=0,1,2,3.
\end{equation}
Therefore, by comparing \er{fgjfjhgghhgjghjhjkkkkgjghghuiiiuujhjh88}
and \er{fgjfjhgghhgjghjhjkkkkgjghghuiiiu} we deduce that
$\{J_{ij}\}_{i,j=0,1,2,3}$ is a two times covariant tensor on the
group $\mathcal{S}_0$, and moreover, this tensor is symmetric. We
call the tensor $\{J_{ij}\}_{0\leq i,j\leq 3}$ covariant kinematic
pseudo-metric tensors of inertia. Using
\er{fgjfjhgghhgjghjhjkkkkgjghghuiiiuujhjh88} we also obtain that the
pseudo-metric tensors $\{J_{ij}\}_{i,j=0,1,2,3}$ and
$\{J^{ij}\}_{0\leq i,j\leq 3}$ are non-degenerate and they are
inverse of each other. Moreover, as before, it can be easily
calculated that if we consider the $4\times 4$-matrix:
\begin{equation}\label{fgjfjhgghhgjghjhjkkkkgjghghuiiiuujhjhjklj88}
J=\{J_{ij}\}_{0\leq i,j\leq 3},
\end{equation}
then
\begin{equation}\label{fgjfjhgghhgjghjhjkkkkgjghghuiiiuujhjh188}
\text{det}\,J=-1.
\end{equation}
In particular, by \er{fgjfjhgghhgjghjhjkkkkgjghghuiiiuujhjh188} and
\er{fgjfjhgghhgjghjhjkkkkgjghghuiiiuujhjh1} we deduce
\begin{equation}\label{fgjfjhgghhgjghjhjkkkkgjghghuiiiuujhjhljhgugii88}
\text{det}\,G=\text{det}\,J.
\end{equation}
where $4\times 4$-matrix $G$ is given by,
\begin{equation}\label{fgjfjhgghhgjghjhjkkkkgjghghuiiiuujhjhjkljojuuiyi88}
G=\{g_{ij}\}_{0\leq i,j\leq 3},
\end{equation}
with the covariant pseudo-metric tensor of the four-dimensional
space-time $\{g_{ij}\}_{0\leq i,j\leq 3}$, given by
\er{hoyuiouigyfg3478zzrrZZffGGhhjhj}. Next, obviously we have
\begin{equation}\label{hoyuiouigyfghgjh3478zzrrZZffGGjkkjojj88iihhkhk88}
{g}^{ij}=J^{ij}+\left(v^iv^j-k^ik^j\right)\quad\quad\forall\,i,j=0,1,2,3,
\end{equation}
where $J^{ij}$ the is the contravariant kinematic pseudo-metric
tensor of inertia, given by
\er{hoyuiouigyfghgjh3478zzrrZZffGGjkkjojj88}, and $g^{ij}$ is the
contravariant pseudo-metric tensor of the four-dimensional
space-time, given by \er{hoyuiouigyfghgjh3478zzrrZZffGGjkkjojj}. In
particular, in the case of the absence of the genuine gravity where
$\vec v=\vec k$ we have
\begin{equation}\label{hoyuiouigyfghgjh3478zzrrZZffGGjkkjojj88iihhkhk88jouyiyi88}
{g}^{ij}=J^{ij}\quad\quad\forall\,i,j=0,1,2,3,
\end{equation}
or equivalently,
\begin{equation}\label{hoyuiouigyfghgjh3478zzrrZZffGGjkkjojj88iihhkhk88jouyiyi88joljljkj88}
{g}_{ij}=J_{ij}\quad\quad\forall\,i,j=0,1,2,3.
\end{equation}

Furthermore, by \er{hoyuiouigyfghgjh3478zzrrZZffGGjkkj88} we have
\begin{equation}\label{fgjfjhgghhgjghjhjijhojihjhjjijhjjjjjuiijjjkhjhjhjuiiuuuyu88jukukukuk88kpjljl}
\sum_{m=0}^{3} \,J^{jm}\,\left(c\,\frac{\partial\tau}{\partial
x^m}\right)=k^j\quad\quad\forall\,j=0,1,2,3.
\end{equation}
where $(k^0,k^1,k^2,k^3)$ is the four-dimensional potential of
inertia, defined by
\er{fgjfjhgghhgjghjhjijhojihjhjjijhjjjjjuiijjjkjkkhhkkh} and $\tau$
is the scalar of the global time on the group $\mathcal{S}_0$,
defined by \er{uguyytfddddgghkjhhjuuiui}. Thus since,
\begin{equation}\label{fgjfjhgghhgjghjhjijhojihjhjjijhjjjjjuiijjjkhjhjhjuiiuuuyu88jukukukuk88kpjljlkjoj}
\sum_{j=0}^{3} \,k^j\,\left(c\,\frac{\partial\tau}{\partial
x^j}\right)=1,
\end{equation}
we deduce the following eikonal type equation
\begin{equation}\label{fgjfjhgghhgjghjhjijhojihjhjjijhjjjjjuiijjjkhjhjhjuiiuuuyu88}
c^2\left(\sum_{j=0}^{3}\sum_{m=0}^{3}\,J^{jm}\frac{\partial\tau}{\partial
x^j}\frac{\partial\tau}{\partial x^m}\right)=1,
\end{equation}
which is similar to
\er{fgjfjhgghhgjghjhjijhojihjhjjijhjjjjjuiijjjkhjhjhjuiiuuuyu},
despite of the different pseudometrics. Moreover, by
\er{fgjfjhgghhgjghjhjijhojihjhjjijhjjjjjuiijjjkhjhjhjuiiuuuyu88jukukukuk88kpjljl}
and \er{fgjfjhgghhgjghjhjkkkkgjghghuiiiuujhjh88} we infer:
\begin{equation}\label{fgjfjhgghhgjghjhjijhojihjhjjijhjjjjjuiijjjkhjhjhjuiiuuuyu88jukukukuk88kpjljlmhmbbvnvvvkljljl88}
\sum_{m=0}^{3} \,J_{jm}\,k^m\,=\,c\,\frac{\partial\tau}{\partial
x^j}\quad\quad\forall\,j=0,1,2,3.
\end{equation}

 Next, as before, note that  the Kinematic pseudo-metric tensors of inertia $\{J^{ij}\}_{0\leq i,j\leq 3}$ and $\{J_{ij}\}_{0\leq i,j\leq 3}$  depend only on
the coordinate system in the space-time and are completely
independent of the physical matter or physical fields filling this
space. In contrast the pseudo-metric tensors of the four-dimensional
space-time $\{g^{ij}\}_{0\leq i,j\leq 3}$ and $\{g_{ij}\}_{0\leq
i,j\leq 3}$, given by \er{hoyuiouigyfghgjh3478zzrrZZffGGjkkj} and
\er{hoyuiouigyfg3478zzrrZZffGGhhjhj}, depend essentially on the
surrounding physical matter (in the model of the Newtonian gravity
through gravitational masses). Furthermore, note that in the absence
of the genuine gravity $\{g^{ij}\}_{0\leq i,j\leq 3}$ and
$\{g_{ij}\}_{0\leq i,j\leq 3}$ coincide with $\{J^{ij}\}_{0\leq
i,j\leq 3}$ and $\{J_{ij}\}_{0\leq i,j\leq 3}$ respectively. In the
model of the Newtonian gravity this happens away from essential
gravitational masses. In particular, $\{g^{ij}\}_{0\leq i,j\leq 3}$
and $\{g_{ij}\}_{0\leq i,j\leq 3}$ tend to $\{J^{ij}\}_{0\leq
i,j\leq 3}$ and $\{J_{ij}\}_{0\leq i,j\leq 3}$ respectively, as
$|\vec x|\to +\infty$. Finally note that since in every inertial
cartesian coordinate system $\vec k$ is a constant we deduce that in
every such a system $\{J^{ij}\}_{0\leq i,j\leq 3}$ and
$\{J_{ij}\}_{0\leq i,j\leq 3}$ are \underline{constant} in the
four-dimensional space-time. So, in contrast to $\{g^{ij}\}_{0\leq
i,j\leq 3}$ and $\{g_{ij}\}_{0\leq i,j\leq 3}$, the pseudo-metrics
$\{J^{ij}\}_{0\leq i,j\leq 3}$ and $\{J_{ij}\}_{0\leq i,j\leq 3}$
are complectly \underline{flat}. In particular, if we consider the
Christoffel Symbols of the Kinematic pseudo-metric tensors of
inertia $\{J_{ij}\}_{0\leq i,j\leq 3}$ defined by,
\begin{equation}\label{jhffhgffgh3478zzrrZZffjjkjkjklkkj88}
\begin{cases}
\left\{\Gamma_{i,mn}\right\}_J:=\frac{1}{2}\left(\frac{\partial
J_{im}}{\partial x_{n}}+\frac{\partial J_{in}}{\partial
x_{m}}-\frac{\partial
J_{mn}}{\partial x_{i}}\right)\\
\left\{\Gamma^{i}_{mn}\right\}_J:=\sum_{j=0}^{3}J^{ij}\left\{\Gamma_{j,mn}\right\}_J
\end{cases}\quad\quad\forall\, i,m,n=0,1,2,3,
\end{equation}
then, since in every inertial cartesian coordinate system
$\{J_{ij}\}_{0\leq i,j\leq 3}$ is \underline{constant} in the
four-dimensional space-time, we deduce that in every such coordinate
system we have
\begin{equation}\label{jhffhgffghhjgj3478zzrrZZffkhkhkljjllkjl88}
\left\{\Gamma^{i}_{mn}\right\}_J=\left\{\Gamma_{i,mn}\right\}_J=0\quad\quad\forall\,
i,m,n=0,1,2,3.
\end{equation}
Thus, if we consider the two times covariant tensor of the covariant
derivative to the covector of the gradient to the scalar of the
global time $\tau$ from \er{uguyytfddddgghkjhhjuuiui}, denoted by
$\left\{\delta_j \left(\frac{\partial\tau}{\partial x_i}
\right)\right\}_J$, and defined by:
\begin{multline}\label{hoyuiouigyfghgjh3478zzrrZZjhjhugyuyughggjhjhjyuuygtyffjjjkjljkjjljokjllllklljkljkjlnnnljjl88}
\left\{\delta_j \left(\frac{\partial\tau}{\partial x_i}
\right)\right\}_J=\left\{\delta_i \left(\frac{\partial\tau}{\partial
x_j} \right)\right\}_J:=\frac{\partial^2\tau}{\partial x_i\partial
x_j}-\sum_{m=0}^{3}\left\{\Gamma^{m}_{ij}\right\}_J\,\frac{\partial
\tau}{\partial x_m}
\\=\frac{\partial^2\tau}{\partial x_i\partial
x_j}-\sum_{m=0}^{3}\left\{\Gamma_{m,ij}\right\}_J\,
\frac{k^m}{c}\quad\quad\forall\, i,j=0,1,2,3\,,
\end{multline}
then by
\er{fgjfjhgghhgjghjhjkkkkgjghghuiiiulkkjlkklplikklkjljghhgghk} and
\er{jhffhgffghhjgj3478zzrrZZffkhkhkljjllkjl88} we prove the
following identity, first in every inertial cartesian coordinate
system and then, by the covariance of this identity, also in every
non-inertial cartesian coordinate system:
\begin{equation}\label{hoyuiouigyfghgjh3478zzrrZZjhjhugyuyughggjhjhjyuuygtyffjjjkjljkjjljokjllllklljkljkjlnnnhkhkhkhkhh88}
\left\{\delta_j \left(\frac{\partial\tau}{\partial x_i}
\right)\right\}_J=0\quad\quad\forall\, i,j=0,1,2,3\,.
\end{equation}
Note here that we put brackets $\left\{\cdot\right\}_J$ for the
Christoffel Symbols and covariant derivative with respect to
Kinematical pseudo-metrics  $\{J^{ij}\}_{0\leq i,j\leq 3}$ and
$\{J_{ij}\}_{0\leq i,j\leq 3}$ in order to distinguish them from
Christoffel Symbols and covariant derivative with respect to
Dynamical pseudo-metrics  $\{g^{ij}\}_{0\leq i,j\leq 3}$ and
$\{g_{ij}\}_{0\leq i,j\leq 3}$, that we will write
\underline{without} any brackets.

Furthermore, since $\vec k$ is \underline{generally trivial}
speed-like vector field, by the definition, there exists a unique,
up to equivalence, "preferable" inertial cartesian coordinate system
($\{0\}$) where we have
\begin{equation}\label{fgjfjhgghhgjghjhjijhojihjhjjijhjjjjjuiijjjkhjhjhjuiiuuuyu88jkhmbmbm88}
\vec k=0,
\end{equation}
and inserting it into \er{hoyuiouigyfghgjh3478zzrrZZffGGjkkj88} and
\er{hoyuiouigyfg3478zzrrZZffGGhhjhj88} we deduce that in system
($\{0\}$) we have
\begin{equation}\label{hoyuiouigyfghgjh3478zzrrZZffGGjkkj88jokhjh88}
\begin{cases}
J^{00}=1\\
J^{ij}=-\delta_{ij}\quad\forall 1\leq i,j\leq 3\\
J^{0j}=J^{j0}=0\quad\forall 1\leq j\leq 3,
\end{cases}
\end{equation}
and
\begin{equation}\label{hoyuiouigyfghgjh3478zzrrZZffGGjkkj88jokhjh88jkhkjhjh88}
\begin{cases}
J_{00}=1\\
J_{ij}=-\delta_{ij}\quad\forall 1\leq i,j\leq 3\\
J_{0j}=J_{j0}=0\quad\forall 1\leq j\leq 3,
\end{cases}
\end{equation}
which is the same as the basic pseudo-metrics in the Special
Relativity.

Next, we define the Dynamical four-covector of genuine gravity
$(s_0,s_1,s_2,s_3)$ by the following:
\begin{equation}\label{huohuioy89gjjhjffffff3478zzrrZZZhjhhjhhjjhhffGGhjjhuiiuihhjkkoioirrkhrrjggggjrrjiokuukuppkpklhjjhj88}
s_j=\frac{1}{2}\left(\sum_{m=0}^{3} g_{jm}k^m-\sum_{m=0}^{3}
J_{jm}v^m\right)\quad \forall j=0,1,2,3,
\end{equation}
where $(k^0,k^1,k^2,k^3)$ is the four-dimensional potential of
inertia, defined by
\er{fgjfjhgghhgjghjhjijhojihjhjjijhjjjjjuiijjjkjkkhhkkh},
$(v^0,v^1,v^2,v^3)$ is the four-dimensional gravitational potential,
defined by \er{fgjfjhgghhgjghjhjijhojihjhjjijhjjjjjuiijjjk},
$\{g_{ij}\}_{0\leq i,j\leq 3}$ defined by
\er{hoyuiouigyfg3478zzrrZZffGGhhjhj} and $\{J_{ij}\}_{0\leq i,j\leq
3}$ defined by \er{hoyuiouigyfg3478zzrrZZffGGhhjhj88}. Then, by
inserting \er{fgjfjhgghhgjghjhjijhojihjhjjijhjjjjjuiijjjkjkkhhkkh},
\er{fgjfjhgghhgjghjhjijhojihjhjjijhjjjjjuiijjjk},
\er{hoyuiouigyfg3478zzrrZZffGGhhjhj} and
\er{hoyuiouigyfg3478zzrrZZffGGhhjhj88} into
\er{huohuioy89gjjhjffffff3478zzrrZZZhjhhjhhjjhhffGGhjjhuiiuihhjkkoioirrkhrrjggggjrrjiokuukuppkpklhjjhj88}
we obtain
\begin{multline}\label{hklghgklghgklghjklghklgjklglhkklk88}
s_0=\frac{1}{2}\left(\sum_{m=0}^{3} g_{0m}k^m-\sum_{m=0}^{3}
J_{0m}v^m\right)=\frac{1}{2}\left(g_{00}k^0-
J_{00}v^0\right)+\frac{1}{2}\left(\sum_{m=1}^{3}
g_{0m}k^m-\sum_{m=1}^{3} J_{0m}v^m\right)\\=
\frac{1}{2}\left(\left(1-\frac{|\vec v|^2}{c^2}\right)-
\left(1-\frac{|\vec
k|^2}{c^2}\right)\right)+\frac{1}{2}\left(\sum_{m=1}^{3}
v^{m}k^m-\sum_{m=1}^{3} k^{m}v^m\right)=\frac{1}{2c^2}|\vec
k|^2-\frac{1}{2c^2}|\vec v|^2=\\-\frac{1}{2c^2}(\vec v-\vec
k)\cdot(\vec k+\vec v)=-\frac{1}{2c^2}|\vec v-\vec
k|^2-\frac{1}{c^2}\vec k\cdot(\vec v-\vec k)=\frac{1}{2c^2}|\vec
v-\vec k|^2-\frac{1}{c^2}\vec v\cdot(\vec v-\vec k),
\end{multline}
and
\begin{multline}\label{hklghgklghgklghjklghklgjklglhkklkjkguitguitnkh88}
s_j=\frac{1}{2}\left(\sum_{m=0}^{3} g_{jm}k^m-\sum_{m=0}^{3}
J_{jm}v^m\right)=\frac{1}{2}\left( g_{j0}k^0-
J_{j0}v^0\right)+\frac{1}{2}\left(\sum_{m=1}^{3}
g_{jm}k^m-\sum_{m=1}^{3} J_{jm}v^m\right)=\\
\frac{1}{2}\left( v^{j}k^0-
k^{j}v^0\right)+\frac{1}{2}\left(\sum_{m=1}^{3}
\left(-\delta_{jm}\right)k^m-\sum_{m=1}^{3}
\left(-\delta_{jm}\right)v^m\right)=v^j-k^j\quad \forall j=1,2,3.
\end{multline}
So, by \er{hklghgklghgklghjklghklgjklglhkklk88} and
\er{hklghgklghgklghjklghklgjklglhkklkjkguitguitnkh88} we deduce that
\begin{multline}\label{fgjfjhgghhgjghjhjijhojihjhjjijhjjjjjuiijjjk88}
(s_0,s_1,s_2,s_3)=\left(-\frac{1}{2c^2}|\vec h|^2-\frac{1}{c^2}\vec
k\cdot\vec h\,,\,\frac{1}{c}\vec h\right)=\left(\frac{1}{2c^2}|\vec
h|^2-\frac{1}{c^2}\vec v\cdot\vec h\,,\,\frac{1}{c}\vec
h\right)\quad\quad\text{where}\quad\\ s_0=-\frac{1}{2c^2}|\vec
h|^2-\frac{1}{c^2}\vec k\cdot\vec h=\frac{1}{2c^2}|\vec
h|^2-\frac{1}{c^2}\vec v\cdot\vec
h\;\;\quad\text{and}\quad\;\;(s_1,s_2,s_3)=\frac{1}{c}\vec h,
\end{multline}
where $\vec h$ is the three-dimensional proper vector field of the
vectorial potential of genuine gravity defined as in
\er{jljjjjkhhhjhjhjhljjjljljkjkCChhhh} by
\begin{equation}\label{huohuioy89gjjhjffffff3478zzrrZZZhjhhjhhjjhhffGGhjjhuiiuihhjkkoioirrkhrrjggggjrrjiokuukuppkpklhjjhjljoyhkiogu88}
\vec h:=\vec v-\vec k.
\end{equation}
In particular, by \er{hoyuiouigyfg3478zzrrZZffGGhhjhj},
\er{hoyuiouigyfg3478zzrrZZffGGhhjhj88},
\er{fgjfjhgghhgjghjhjijhojihjhjjijhjjjjjuiijjjk88} and
\er{fgjfjhgghhgjghjhjkkkkgjghghuiiiulkkjlkklplikklkjljghhgghk}
together we deduce:
\begin{equation}\label{hoyuiouigyfghgjh3478zzrrZZffGGjkkjojj88iihhkhk88jklhkhjkhk88}
{g}_{ij}=J_{ij}+\left(s_i\left(c\,\frac{\partial\tau}{\partial
x^j}\right)+\left(c\,\frac{\partial\tau}{\partial
x^i}\right)s_j\right)\quad\quad\forall\,i,j=0,1,2,3.
\end{equation}
Moreover, by \er{fgjfjhgghhgjghjhjijhojihjhjjijhjjjjjuiijjjk88}
\er{huohuioy89gjjhjffffff3478zzrrZZZhjhhjhhjjhhffGGhjjhuiiuihhjkkoioi}
and \er{hoyuiouigyfghgjh3478zzrrZZffGGjkkjojj88} we obtain:
\begin{equation}\label{hoyuiouigyfghgjh3478zzrrZZffGGjkkjojj88jklkhhk88khjhgh88}
v^j-k^j=\sum_{m=0}^{3}{\Theta}^{jm}s_m=\sum_{m=0}^{3}\left(k^jk^m-J^{jm}\right)s_m\quad\quad\forall\,j=0,1,2,3.
\end{equation}

\subsection{Physical laws in non-cartesian or curvilinear coordinate systems in the
non-relativistic space-time}
Let $\mathcal{S}$ be the group of all smooth non-degenerate
invertible transformations from $\mathbb{R}^4$ onto $\mathbb{R}^4$
having the form \er{fgjfjhgghyuyyu}:
\begin{equation}\label{fgjfjhgghyuyyuyughg}
\begin{cases}
x'^0=f^{(0)}(x^0,x^1,x^2,x^3),\\
x'^1=f^{(1)}(x^0,x^1,x^2,x^3),\\
x'^2=f^{(2)}(x^0,x^1,x^2,x^3),\\
x'^3=f^{(3)}(x^0,x^1,x^2,x^3),
\end{cases}
\end{equation}
and let $\mathcal{S}_0$ be a subgroup of transformations of the form
\er{noninchgravortbstrjgghguittu2intrrrZZygjyghhj}. Then, it is
clear, that given any object that is a scalar, four-vector,
four-covector, two-times covariant tensor or two-times contravariant
tensor on the group $\mathcal{S}_0$, defined in every cartesian
non-inertial coordinate system, we can uniquely extend the
definition of this object, in such a way that it will be defined
also in every curvilinear coordinate systems in $\mathbb{R}^4$ and
will be respectively a scalar, four-vector, four-covector, two-times
covariant tensor or two-times contravariant tensor on the wider
group $\mathcal{S}$. Thus all the physical laws that have a
covariant form preserve their form also in transformations of the
form \er{fgjfjhgghyuyyuyughg} i.e. in curvilinear coordinate
systems. In particular, the Maxwell Equations in every curvilinear
coordinate system have the form of
\er{khjhhkfgjfjhgghhgjghjhjkkkkgjghghuiiiulkkjlkklKKgfgjhjjghgjhhjhjhhhhhghhgtyytojjjjjuyiyuughgfg}
or equivalently of
\er{khjhhkfgjfjhgghhgjghjhjkkkkgjghghuiiiulkkjlkklKKgfgjhjjghgjhhjhjhhhhhghhgtyytojjjjjuyiyuughgfghhj}:
\begin{multline}\label{khjhhkfgjfjhgghhgjghjhjkkkkgjghghuiiiulkkjlkklKKgfgjhjjghgjhhjhjhhhhhghhgtyytojjjjjuyiyuughgfghjjhkpk}
\sum_{j=0}^{3}\frac{\partial}{\partial
x^j}\left(\sum_{m=0}^{3}\sum_{n=0}^{3}g^{km}g^{jn}\left(\frac{\partial
A_n}{\partial x^m}-\frac{\partial A_m}{\partial
x^n}\right)\right)+\\
\sum_{j=0}^{3}\frac{1}{\sqrt{|\text{det}\,G|}}\frac{\partial}{\partial
x^j}\left(\sqrt{|\text{det}\,G|}\right)\left(\sum_{m=0}^{3}\sum_{n=0}^{3}g^{km}g^{jn}\left(\frac{\partial
A_n}{\partial x^m}-\frac{\partial A_m}{\partial x^n}\right)\right)
=-4\pi j^k\quad\quad\forall\, k=0,1,2,3,
\end{multline}
or equivalently:
\begin{equation}\label{khjhhkfgjfjhgghhgjghjhjkkkkgjghghuiiiulkkjlkklKKgfgjhjjghgjhhjhjhhhhhghhgtyytojjjjjuyiyuughgfghhjkkhj}
\sum_{j=0}^{3}\frac{\partial}{\partial
x^j}\left(\sum_{m=0}^{3}\sum_{n=0}^{3}\sqrt{|\text{det}\,G|}\,g^{km}g^{jn}\left(\frac{\partial
A_n}{\partial x^m}-\frac{\partial A_m}{\partial x^n}\right)\right)
=-4\pi \sqrt{|\text{det}\,G|}\, j^k\quad\quad\forall\, k=0,1,2,3.
\end{equation}
Here $\{A_k\}_{k=0,1,2,3}$ is the four-covector of the
electromagnetic potential, $\{j^k\}_{k=0,1,2,3}$ is the four-vector
of the current and $G:=\{g_{kj}\}_{k,j=0,1,2,3}$,
$\{g^{kj}\}_{k,j=0,1,2,3}$ are pseudo-metric covariant and
contravariant tensors. Note, that in curvilinear coordinate system
we can have $\text{det}\,G\neq Const$ and thus we need to consider
the enlarged form
\er{khjhhkfgjfjhgghhgjghjhjkkkkgjghghuiiiulkkjlkklKKgfgjhjjghgjhhjhjhhhhhghhgtyytojjjjjuyiyuughgfg}
instead of
\er{khjhhkfgjfjhgghhgjghjhjkkkkgjghghuiiiulkkjlkklKKgfgjhjjghgjhhjhjhhhhhghhgtyytojjjjjuyiyuu}.
Moreover, the density of the Lagrangian of the electromagnetic field
in every curvilinear coordinate system in $\mathbb{R}^4$ also has a
form of \er{vhfffngghkjgghPPNggjgjjkgjoiuioiggh}:
\begin{multline}\label{vhfffngghkjgghPPNggjgjjkgjoiuioigghjhhh}
L_1=\frac{1}{4\pi}\left(-\sum_{n=0}^{3}\sum_{k=0}^{3}\frac{1}{4}F^{nk}F_{nk}-\sum_{k=0}^{3}4\pi
j^k A_k\right)=\\
\frac{1}{4\pi}\left(-\sum_{n=0}^{3}\sum_{k=0}^{3}\sum_{m=0}^{3}\sum_{p=0}^{3}\frac{1}{4}g^{mn}g^{pk}\left(\frac{\partial
A_p}{\partial x^m}-\frac{\partial A_m}{\partial
x^p}\right)\left(\frac{\partial A_k}{\partial x^n}-\frac{\partial
A_n}{\partial x^k}\right)-\sum_{k=0}^{3}4\pi j^k A_k\right),
\end{multline}
where
\begin{equation}\label{huohuioy89gjjhjffffff3478zzrrZZZhjhhjhhjjhhffGGhjjhjhhjhj}
F_{ij}:=\frac{\partial A_j}{\partial x^i}-\frac{\partial
A_i}{\partial x^j}\quad\quad\forall\, i,j=0,1,2,3\,.
\end{equation}

Next the general Lagrangian of motion of the charged particle in the
gravitational and electromagnetic field
\er{btfffygtgyggyijhhkkSYSPNuhuygyygyggyuyyuyuykuhghgjjoj} preserve
its form in every curvilinear coordinate system:
\begin{equation}\label{btfffygtgyggyijhhkkSYSPNuhuygyygyggyuyyuyuykuhghgjjojhjfg}
J_{\mathcal{G}}(\chi)=
\int_0^T\left\{-mc^2\;\mathcal{G}\left(\sum_{j=0}^{3}\sum_{k=0}^{3}g_{jk}\left(\chi(t)\right)\,\frac{d\chi^j}{dt}\,\frac{d\chi^k}{dt}\right)-\sum_{k=0}^{3}\sigma
A_k\left(\chi(t)\right)\,\frac{d\chi^k}{dt}\right\}dt.
\end{equation}
where $t$ is the global time, which is a scalar on the group
$\mathcal{S}$,
\begin{equation}\label{btfffjhgjghghijhh}
\left(\chi^0(t),\chi^1(t),\chi^2(t),\chi^3(t)\right):=\left(\frac{1}{c}x^0(t),\frac{1}{c}x_1(t),\frac{1}{c}x_2(t),\frac{1}{c}x_3(t)\right),
\end{equation}
and $\left(x^0(t),x^1(t),x^2(t),x^3(t)\right)\in\mathbb{R}^4$ is a
four-dimensional space-time trajectory of the particle,
parameterized by the global time.

Note that if we denote by $t$ the scalar of global time, then in a
general curvilinear coordinate system the coordinate $x^0$ can
differ from $ct$, and the equality $x^0=ct$ valid, in general,
\underline{only} in \underline{cartesian} inertial or non-inertial
coordinate systems. However, since the equality in
\er{fgjfjhgghhgjghjhjijhojihjhjjijhjjjjjuiijjjkhjhjhjuiiuuuyu} has a
covariant form, the scalar of the global time $t$ satisfies the
following Eikonal-type equation in every curvilinear coordinate
system:
\begin{equation}\label{fgjfjhgghhgjghjhjijhojihjhjjijhjjjjjuiijjjkhjhjhjuiiuuuyuuoiuuiio}
\sum_{j=0}^{3}\sum_{m=0}^{3}\,g^{jm}(x^0,x^1,x^2,x^3)\,\frac{\partial
t}{\partial x^j}(x^0,x^1,x^2,x^3)\,\frac{\partial t}{\partial
x^m}(x^0,x^1,x^2,x^3)\,=\, \frac{1}{c^2}.
\end{equation}
Moreover, since the equality in
\er{fgjfjhgghhgjghjhjijhojihjhjjijhjjjjjuiijjjkhjhjhjuiiuuuyu} also
has a covariant form, the following identity is valid in every
curvilinear coordinate system:
\begin{equation}\label{fgjfjhfgfgfgfhihhjhhjjhhjhj}
\sum_{m=0}^{3}{\Theta}^{jm}\frac{\partial t}{\partial
x^m}=0\quad\quad\forall\, j=0,1,2,3,
\end{equation}
where $\Theta^{ij}$ is the contravariant tensor of the
three-dimensional geometry, that has the form
\er{huohuioy89gjjhjffffff3478zzrrZZZhjhhjhhjjhhffGGhjjhuiiuihhjkkoioi}
\underline{only} in \underline{cartesian} inertial or non-inertial
coordinate systems.

Next, in the particular case of the relativistic-like Lagrangian
where $\mathcal{G}(\tau):=\sqrt{\tau}$, the Lagrangian in
\er{btfffygtgyggyijhhkkSYSPNuhuygyygyggyuyyuyuykuhghgjjojiyyjljlgh}
also preserve their form in every curvilinear coordinate system:
\begin{equation}\label{btfffygtgyggyijhhkkSYSPNuhuygyygyggyuyyuyuykuhghgjjojiyyjljlghhhhjhj}
J_{rl}(\chi)=
\int_a^b\left\{-mc^2\;\sqrt{\left(\sum_{j=0}^{3}\sum_{k=0}^{3}g_{jk}\left(\chi(s)\right)\,\frac{d\chi^j}{ds}\,\frac{d\chi^k}{ds}\right)}-\sum_{k=0}^{3}\sigma
A_k\left(\chi(s)\right)\,\frac{d\chi^k}{ds}\right\}ds,
\end{equation}
where $s$ is the arbitrary parameter of the trajectory with
increasing mapping $t\leftrightarrow s$:
\begin{equation}\label{btfffjhgjghghijhhhu}
\left(\chi^0(s),\chi^1(s),\chi^2(s),\chi^3(s)\right):=\left(\frac{1}{c}x^0(s),\frac{1}{c}x_1(s),\frac{1}{c}x_2(s),\frac{1}{c}x_3(s)\right).
\end{equation}
In particular, in the case $\frac{\partial \chi^0}{\partial t}>0$ we
can take $s:=\chi^0$ in
\er{btfffygtgyggyijhhkkSYSPNuhuygyygyggyuyyuyuykuhghgjjojiyyjljlghhhhjhj}.

Finally, we would like to note the following fact: since in the
absence of essential gravitational masses we have $g_{ij}=J_{ij}$,
there exists a unique, up to equivalence, "preferable" inertial
coordinate system where $\vec v=\vec k=0$ everywhere. In this
particular system by
\er{hoyuiouigyfghgjh3478zzrrZZffGGjkkj88jokhjh88jkhkjhjh88} we have:
\begin{equation}\label{hoyuiouigyfg3478zzrrZZffGGhhjhjiui}
\begin{cases}
g_{00}
=1\\
g_{ij}
=-\delta_{ij}\quad\forall 1\leq i,j\leq 3\\
g_{0j}=g_{j0}
=0\quad\forall 1\leq j\leq 3.
\end{cases}
\end{equation}
and thus the Maxwell equations are the same as in the Special
Relativity. Moreover, in this system the Lagrangian of the motion of
the particle of the form
\er{btfffygtgyggyijhhkkSYSPNuhuygyygyggyuyyuyuykuhghgjjojiyyjljlghhhhjhj}
is also the same as in the Special Relativity. Thus, since Maxwell
equations
\er{khjhhkfgjfjhgghhgjghjhjkkkkgjghghuiiiulkkjlkklKKgfgjhjjghgjhhjhjhhhhhghhgtyytojjjjjuyiyuughgfghjjhkpk}
and the Lagrangian of the motion of particles
\er{btfffygtgyggyijhhkkSYSPNuhuygyygyggyuyyuyuykuhghgjjojiyyjljlghhhhjhj}
preserve their form in every cartesian, non-cartesian or curvilinear
coordinate system of the group $\mathcal{S}$, they stay the same as
in Special Relativity also in the case of every cartesian,
non-cartesian or curvilinear coordinate system. Thus in the
particular case of $\mathcal{G}(\tau):=\sqrt{\tau}$ in
\er{btfffygtgyggyijhhkkSYSPNuhuygyygyggyuyyuyuykuhghgjjojhjfg} and
in the absence of essential gravitational masses, the unique formal
mathematical difference between our model and the Special Relativity
is that in the frames of our model we consider the Galilean
Transformations as transformations of the change of inertial
cartesian coordinate systems, up to equivalence, and
\er{noninchgravortbstr} as transformations of the change of
non-inertial cartesian coordinate systems, however the Lorenz
transformations lead to inertial \underline{non-cartesian}
coordinate systems (see the following Definition
\ref{gjgjkfgkjfjkfjfjdf}). In contrast, in the Special Relativity
the fundamental role of the Lorenz transformations, i.e. the
transformations that preserve the form
\er{hoyuiouigyfg3478zzrrZZffGGhhjhjiui} of the pseudo-metric tensor,
is postulated as the role of transformations of the change of
inertial cartesian coordinate systems, and at the same time the
Galilean Transformations lead to inertial \underline{non-cartesian}
coordinate systems, and transformations \er{noninchgravortbstr} lead
to non-inertial \underline{non-cartesian} coordinate systems.

\subsubsection{Some general covariant identities in non-cartesian or curvilinear coordinate systems}
Consider the contravariant kinematic pseudo-metric tensor of inertia
$\{J^{ij}\}_{0\leq i,j\leq 3}$ on the group $\mathcal{S}$, defined
by
\begin{equation}\label{hoyuiouigyfghgjh3478zzrrZZffGGjkkjojj88jkhkhjh}
J^{ij}:=k^ik^j-{\Theta}^{ij}\quad\quad\forall\,i,j=0,1,2,3,
\end{equation}
where $\{{\Theta}^{ij}\}_{0\leq i,j\leq 3}$ is the contravariant
tensor of the three-dimensional geometry on the group $\mathcal{S}$
(the same as in \er{fgjfjhfgfgfgfhihhjhhjjhhjhj}) defined in
cartesian coordinate systems by
\er{huohuioy89gjjhjffffff3478zzrrZZZhjhhjhhjjhhffGGhjjhuiiuihhjkkoioi},
and $(k^0,k^1,k^2,k^3)$ is the four-vector of the potential of
inertia on the group $\mathcal{S}$, defined in cartesian coordinate
systems by \er{fgjfjhgghhgjghjhjijhojihjhjjijhjjjjjuiijjjkjkkhhkkh}.
Then, as before, the reverse to $\{J^{ij}\}_{0\leq i,j\leq 3}$
covariant tensor on the group $\mathcal{S}$ is denoted as
$\{J_{ij}\}_{0\leq i,j\leq 3}$ and satisfies
\begin{equation}\label{fgjfjhgghhgjghjhjkkkkgjghghuiiiuujhjh88jgkgjggk}
\sum_{m=0}^{3}J^{im}J_{mj}=\begin{cases}
1\quad\text{if}\quad i=j\\
0\quad\text{if}\quad i\neq j
\end{cases}\quad\quad\forall\, i,j=0,1,2,3
\end{equation}
(as in \er{fgjfjhgghhgjghjhjkkkkgjghghuiiiuujhjh88}).
\begin{definition}\label{gjgjkfgkjfjkfjfjdfjhh} We say
that a given general coordinate system ($*$) is
\underline{cartesian} if the contravariant tensor of the
three-dimensional geometry $\{{\Theta}^{ij}\}_{0\leq i,j\leq 3}$ has
the following simple form in the system ($*$) (the same as in
\er{huohuioy89gjjhjffffff3478zzrrZZZhjhhjhhjjhhffGGhjjhuiiuihhjkkoioi}):
\begin{equation}\label{huohuioy89gjjhjffffff3478zzrrZZZhjhhjhhjjhhffGGhjjhuiiuihhjkkoioijkljljkjkjk}
\begin{cases}
{\Theta}^{00}=0
\\
{\Theta}^{0j}={\Theta}^{j0}=0\quad\quad\forall\, j=1,2,3
\\
{\Theta}^{ij}:=\delta_{ij}\quad\quad\forall\, i,j=1,2,3,
\end{cases}
\end{equation}
and at the same time in the system ($*$) we have
\begin{equation}\label{fgjfjhgghhgjghjhjijhojihjhjjijhjjjjjuiijjjkhjhjhjuiiuuuyuuoiuuiiokljkjkjljlj}
c\frac{\partial t}{\partial x^0}(x^0,x^1,x^2,x^3)=1,
\end{equation}
where $t$ is the scalar of global time (the same as in
\er{fgjfjhfgfgfgfhihhjhhjjhhjhj}). In particular, by
\er{fgjfjhfgfgfgfhihhjhhjjhhjhj},
\er{huohuioy89gjjhjffffff3478zzrrZZZhjhhjhhjjhhffGGhjjhuiiuihhjkkoioijkljljkjkjk}
and
\er{fgjfjhgghhgjghjhjijhojihjhjjijhjjjjjuiijjjkhjhjhjuiiuuuyuuoiuuiiokljkjkjljlj}
all together in the system ($*$) we have:
\begin{equation}\label{fgjfjhgghhgjghjhjijhojihjhjjijhjjjjjuiijjjkhjhjhjuiiuuuyuuoiuuiiokljkjkjljjkkhhl}
c\left(\frac{\partial t}{\partial
x^0}(x^0,x^1,x^2,x^3),\frac{\partial t}{\partial
x^1}(x^0,x^1,x^2,x^3),\frac{\partial t}{\partial
x^2}(x^0,x^1,x^2,x^3),\frac{\partial t}{\partial
x^3}(x^0,x^1,x^2,x^3)\right)=(1,0,0,0).
\end{equation}
Then, it easily can be shown that given an cartesian coordinate
system ($*$) and a general coordinate system ($**$), the system
($**$) is cartesian if and only if, up to a constant shift of the
time, the change of coordinates from system ($*$) to system ($**$)
is given by \er{noninchgravortbstrjgghguittu2intrrrZZygjyghhj} or
equivalently by \er{noninchgravortbstrjgghguittu2intrrrZZygjyg}:
\begin{equation}\label{noninchgravortbstrjgghguittu2intrrrZZygjyghkujhkhilyiljg}
\begin{cases}
\vec x'=A\left(\frac{x_0}{c}\right)\cdot\vec x+\vec z\left(\frac{x_0}{c}\right)=A\left(t\right)\cdot\vec x+\vec z\left(t\right),\\
x'_0=x_0=t,
\end{cases}
\end{equation}
where $A(t)\in SO(3)$ is a rotation.
\end{definition}
\begin{definition}\label{gjgjkfgkjfjkfjfjdf} Similarly to cartesian coordinate systems, we say
that a given non-cartesian or curvilinear coordinate system ($*$) is
\underline{inertial} if the \underline{four-vector} of the potential
of inertia $(k^0,k^1,k^2,k^3)$, defined in cartesian systems by
\er{fgjfjhgghhgjghjhjijhojihjhjjijhjjjjjuiijjjkjkkhhkkh} is constant
in $\mathbb{R}^4$. Then, it easily can be shown that given an
inertial coordinate system ($*$) and a general coordinate system
($**$), the system ($**$) is inertial if and only if the change of
coordinates from system ($*$) to system ($**$) is given by a
\underline{linaer} transformation in $\mathbb{R}^4$. In particular,
a general coordinate system ($***$) is inertial if and only if the
tensors $\{J^{ij}\}_{0\leq i,j\leq 3}$ and $\{J_{ij}\}_{0\leq
i,j\leq 3}$ are \underline{constant} in the four-dimensional
space-time in the given system ($***$). Moreover, given an inertial
cartesian coordinate system ($*$) and a general coordinate system
($**$), the system ($**$) is inertial and at the same time is
cartesian if and only if, up to a constant shift of time and, up to
equivalence of coordinate systems, the change of coordinates from
system ($*$) to system ($**$) is given by the Galilean
transformation.
\end{definition}
In the general non-cartesian or curvilinear coordinate system we
obviously have the following list of general covariant identities:
\begin{itemize}
\item
\begin{equation}\label{fgjfjhgghhgjghjhjkkkkgjghghuiiiuujhjhljhgugii88;jl;jk}
\text{det}\,G=\text{det}\,J
\end{equation}
(see \er{fgjfjhgghhgjghjhjkkkkgjghghuiiiuujhjhljhgugii88}), where
$4\times 4$-matrix $J$ is given by,
\begin{equation}\label{fgjfjhgghhgjghjhjkkkkgjghghuiiiuujhjhjkljojuuiyi88gkgkkkkhhkhkh}
J=\{J_{ij}\}_{0\leq i,j\leq 3},
\end{equation}
and $4\times 4$-matrix $G$ is given by,
\begin{equation}\label{fgjfjhgghhgjghjhjkkkkgjghghuiiiuujhjhjkljojuuiyi88gkgkk;}
G=\{g_{ij}\}_{0\leq i,j\leq 3},
\end{equation}
with the covariant pseudo-metric tensor of the four-dimensional
space-time $\{g_{ij}\}_{0\leq i,j\leq 3}$.
\item
\begin{equation}\label{hoyuiouigyfghgjh3478zzrrZZjhjhugyuyughggjhjhjyuuygtyffjjjkjljkjjljokjllllklljkljkjlnnnhkhkhkhkhh8899}
\left\{\delta_j \left(\frac{\partial t}{\partial x_i}
\right)\right\}_J=0\quad\quad\forall\, i,j=0,1,2,3
\end{equation}
(see
\er{hoyuiouigyfghgjh3478zzrrZZjhjhugyuyughggjhjhjyuuygtyffjjjkjljkjjljokjllllklljkljkjlnnnhkhkhkhkhh88}),
where
\begin{multline}\label{hoyuiouigyfghgjh3478zzrrZZjhjhugyuyughggjhjhjyuuygtyffjjjkjljkjjljokjllllklljkljkjlnnnljjl8899}
\left\{\delta_j \left(\frac{\partial t}{\partial x_i}
\right)\right\}_J=\left\{\delta_i \left(\frac{\partial t}{\partial
x_j} \right)\right\}_J:=\frac{\partial^2 t}{\partial x^i\partial
x^j} -\sum_{m=0}^{3}\left\{\Gamma^{m}_{ij}\right\}_J\,\frac{\partial
t}{\partial x^m}\quad\quad\forall\, i,j=0,1,2,3\,,
\end{multline}
with
\begin{equation}\label{jhffhgffgh3478zzrrZZffjjkjkjklkkj8899}
\begin{cases}
\left\{\Gamma_{i,mn}\right\}_J:=\frac{1}{2}\left(\frac{\partial
J_{im}}{\partial x_{n}}+\frac{\partial J_{in}}{\partial
x_{m}}-\frac{\partial
J_{mn}}{\partial x_{i}}\right)\\
\left\{\Gamma^{i}_{mn}\right\}_J:=\sum_{j=0}^{3}J^{ij}\left\{\Gamma_{j,mn}\right\}_J
\end{cases}\quad\quad\forall\, i,m,n=0,1,2,3.
\end{equation}
\item
\begin{equation}\label{fgjfjhgghhgjghjhjijhojihjhjjijhjjjjjuiijjjkhjhjhjuiiuuuyu88jukukukuk88kpjljl99}
\sum_{m=0}^{3} \,J^{jm}\,\left(c\,\frac{\partial t}{\partial
x^m}\right)=k^j\quad\quad\forall\,j=0,1,2,3,
\end{equation}
and
\begin{equation}\label{fgjfjhgghhgjghjhjijhojihjhjjijhjjjjjuiijjjkhjhjhjuiiuuuyu88jukukukuk88kpjljlkk99}
\sum_{m=0}^{3} \,g^{jm}\,\left(c\,\frac{\partial t}{\partial
x^m}\right)=v^j\quad\quad\forall\,j=0,1,2,3.
\end{equation}
\item
\begin{equation}\label{fgjfjhgghhgjghjhjijhojihjhjjijhjjjjjuiijjjkhjhjhjuiiuuuyu88jukukukuk88kpjljlkjojjhgjkl}
\sum_{j=0}^{3} \,k^j\,\left(c\,\frac{\partial t}{\partial
x^j}\right)=1=\sum_{j=0}^{3} \,v^j\,\left(c\,\frac{\partial
t}{\partial x^j}\right),
\end{equation}
\item
\begin{equation}\label{fgjfjhgghhgjghjhjijhojihjhjjijhjjjjjuiijjjkhjhjhjuiiuuuyu88mnmbbbjjj}
c^2\left(\sum_{j=0}^{3}\sum_{m=0}^{3}\,J^{jm}\frac{\partial
t}{\partial x^j}\frac{\partial t}{\partial
x^m}\right)=1=c^2\left(\sum_{j=0}^{3}\sum_{m=0}^{3}\,g^{jm}\frac{\partial
t}{\partial x^j}\frac{\partial t}{\partial x^m}\right).
\end{equation}
\item
\begin{equation}\label{fgjfjhgghhgjghjhjijhojihjhjjijhjjjjjuiijjjkhjhjhjuiiuuuyu88jukukukuk88kpjljlmhmbbvnvvvkljljl88jkhkhk}
\sum_{m=0}^{3} \,J_{jm}\,k^m\,=\,c\,\frac{\partial t}{\partial
x^j}\,=\,\sum_{m=0}^{3} \,g_{jm}\,v^m\quad\quad\forall\,j=0,1,2,3.
\end{equation}
\item
\begin{equation}\label{fgjfjhfgfgfgfhihhjhhjjhhjhjljjljkljjk}
\sum_{m=0}^{3}{\Theta}^{jm}\frac{\partial t}{\partial
x^m}=0\quad\quad\forall\, j=0,1,2,3.
\end{equation}
\item
\begin{equation}\label{hoyuiouigyfghgjh3478zzrrZZffGGjkkjojj88jkhkhjhjljljklkl}
J^{ij}=k^ik^j-{\Theta}^{ij}\quad\quad\forall\,i,j=0,1,2,3,
\end{equation}
and
\begin{equation}\label{hoyuiouigyfghgjh3478zzrrZZffGGjkkjojj88jkhkhjhjljljjjkhhkhjk;k;}
g^{ij}=v^iv^j-{\Theta}^{ij}\quad\quad\forall\,i,j=0,1,2,3,
\end{equation}
that implies, in particular,
\begin{equation}\label{hoyuiouigyfghgjh3478zzrrZZffGGjkkjojj88jkhkhjhjljljjjkhhkhjk;k;jkljhk}
g^{ij}=\left(v^iv^j-k^ik^j\right)+J^{ij}\quad\quad\forall\,i,j=0,1,2,3.
\end{equation}
\item
\begin{equation}\label{fgjfjhgghhgjghjhjkkkkgjghghuiiiuujhjh88jgkgjggkjjjk}
\sum_{m=0}^{3}J^{im}J_{mj}=\sum_{m=0}^{3}g^{im}g_{mj}=\begin{cases}
1\quad\text{if}\quad i=j\\
0\quad\text{if}\quad i\neq j
\end{cases}\quad\quad\forall\, i,j=0,1,2,3,
\end{equation}
that implies together with
\er{fgjfjhgghhgjghjhjijhojihjhjjijhjjjjjuiijjjkhjhjhjuiiuuuyu88jukukukuk88kpjljlmhmbbvnvvvkljljl88jkhkhk}
and
\er{fgjfjhgghhgjghjhjijhojihjhjjijhjjjjjuiijjjkhjhjhjuiiuuuyu88mnmbbbjjj}
the following:
\begin{equation}\label{fgjfjhgghhgjghjhjijhojihjhjjijhjjjjjuiijjjkhjhjhjuiiuuuyu88mnmbbbjjjoouuoojou}
\sum_{j=0}^{3}\sum_{m=0}^{3}\,J_{jm}
k^jk^m=1=\sum_{j=0}^{3}\sum_{m=0}^{3}\,g_{jm}v^jv^m.
\end{equation}
\end{itemize}
Moreover, in the general non-cartesian or curvilinear coordinate
system, as before, we can define the Dynamical four-covector of
genuine gravity $(s_0,s_1,s_2,s_3)$ by:
\begin{equation}\label{huohuioy89gjjhjffffff3478zzrrZZZhjhhjhhjjhhffGGhjjhuiiuihhjkkoioirrkhrrjggggjrrjiokuukuppkpklhjjhj88khhjh}
s_j=\frac{1}{2}\left(\sum_{m=0}^{3} g_{jm}k^m-\sum_{m=0}^{3}
J_{jm}v^m\right)\quad \forall j=0,1,2,3
\end{equation}
(see
\er{huohuioy89gjjhjffffff3478zzrrZZZhjhhjhhjjhhffGGhjjhuiiuihhjkkoioirrkhrrjggggjrrjiokuukuppkpklhjjhj88}).
Then we have the following covariant identities
\begin{equation}\label{hoyuiouigyfghgjh3478zzrrZZffGGjkkjojj88iihhkhk88jklhkhjkhk8899}
{g}_{ij}=J_{ij}+\left(s_i\left(c\,\frac{\partial t}{\partial
x^j}\right)+\left(c\,\frac{\partial t}{\partial
x^i}\right)s_j\right)\quad\quad\forall\,i,j=0,1,2,3
\end{equation}
(see
\er{hoyuiouigyfghgjh3478zzrrZZffGGjkkjojj88iihhkhk88jklhkhjkhk88}),
and
\begin{equation}\label{hoyuiouigyfghgjh3478zzrrZZffGGjkkjojj88jklkhhk88khjhgh8899}
v^j-k^j=\sum_{m=0}^{3}{\Theta}^{jm}s_m=\sum_{m=0}^{3}\left(k^jk^m-J^{jm}\right)s_m\quad\quad\forall\,j=0,1,2,3
\end{equation}
(see \er{hoyuiouigyfghgjh3478zzrrZZffGGjkkjojj88jklkhhk88khjhgh88}).

\begin{definition}\label{gjkfgjhfgdhdgdh}
We say that a given general coordinate system ($*$) is
\underline{Lorentzian} if in system ($*$) kinematic pseudo-metric
tensors of inertia $\{J^{ij}\}_{0\leq i,j\leq 3}$ and
$\{J_{ij}\}_{0\leq i,j\leq 3}$ have the following simple form (the
same as in \er{hoyuiouigyfghgjh3478zzrrZZffGGjkkj88jokhjh88} and
\er{hoyuiouigyfghgjh3478zzrrZZffGGjkkj88jokhjh88jkhkjhjh88}):
\begin{equation}\label{hoyuiouigyfghgjh3478zzrrZZffGGjkkj88jokhjh8899}
\begin{cases}
J^{00}=1\\
J^{ij}=-\delta_{ij}\quad\forall 1\leq i,j\leq 3\\
J^{0j}=J^{j0}=0\quad\forall 1\leq j\leq 3,
\end{cases}
\end{equation}
and
\begin{equation}\label{hoyuiouigyfghgjh3478zzrrZZffGGjkkj88jokhjh88jkhkjhjh8899}
\begin{cases}
J_{00}=1\\
J_{ij}=-\delta_{ij}\quad\forall 1\leq i,j\leq 3\\
J_{0j}=J_{j0}=0\quad\forall 1\leq j\leq 3.
\end{cases}
\end{equation}
Thus, since in every Lorentzian coordinate system $\{J^{ij}\}_{0\leq
i,j\leq 3}$ and $\{J_{ij}\}_{0\leq i,j\leq 3}$ are constant in
$\mathbb{R}^4$, as it was explained in the end of Definition
\ref{gjgjkfgkjfjkfjfjdf} every Lorentzian coordinate system is
necessary \underline{inertial}, however it is not necessary
cartesian. In particular, a unique, up to equivalence, "preferable"
inertial cartesian coordinate system ($\{0\}$) where the
three-dimensional vector potential of inertia satisfies $\vec k=0$,
is a unique, up to equivalence, coordinate system which is cartesian
and Lorentzian \underline{simultaneously}. On the other hand, there
exists a plenty of Lorentzian coordinate systems. In particular, it
is well known that, given a Lorentzian coordinate system ($*$) and a
general coordinate system ($**$), the system ($**$) is also
Lorentzian, if and only if the change of coordinates from system
($*$) to system ($**$) is given by the usual \underline{Lorentz}
transformation, up to certain trivial transformations. In every
Lorentzian coordinate system the four-vector of the potential of
inertia $(k^1,k^2,k^3,k^4)$ is \underline{constant} which is by
\er{fgjfjhgghhgjghjhjijhojihjhjjijhjjjjjuiijjjkhjhjhjuiiuuuyu88mnmbbbjjjoouuoojou}
and \er{hoyuiouigyfghgjh3478zzrrZZffGGjkkj88jokhjh88jkhkjhjh8899}
always satisfies
\begin{equation}\label{hoyuiouigyfghgjh3478zzrrZZffGGjkkjojj88iihhkhk88jklhkhjkhk8899jhjhjh}
\left(k^0\right)^2-\left(\left(k^1\right)^2+\left(k^2\right)^2+\left(k^3\right)^2\right)=1.
\end{equation}
In fact in a general Lorentzian coordinate system the four-vector
$(k^1,k^2,k^3,k^4)$ can be arbitrary constant four-vector,
satisfying
\er{hoyuiouigyfghgjh3478zzrrZZffGGjkkjojj88iihhkhk88jklhkhjkhk8899jhjhjh}.
We remind that in the unique cartesian Lorentzian coordinate system
we have $(k^1,k^2,k^3,k^4)=(1,0,0,0)$. Next by
\er{fgjfjhgghhgjghjhjijhojihjhjjijhjjjjjuiijjjkhjhjhjuiiuuuyu88jukukukuk88kpjljlmhmbbvnvvvkljljl88jkhkhk}
together with \er{hoyuiouigyfghgjh3478zzrrZZffGGjkkj88jokhjh8899} we
have:
\begin{equation}\label{fgjfjhgghhgjghjhjijhojihjhjjijhjjjjjuiijjjkhjhjhjuiiuuuyu88jukukukuk88kpjljlmhmbbvnvvvkljljl88jkhkhkjkjkjkjkklhhhjhkhhhk}
\left(\frac{\partial t}{\partial x^0},\frac{\partial t}{\partial
x^1},\frac{\partial t}{\partial x^2},\frac{\partial t}{\partial
x^3}\right) \,=\, \frac{1}{c}\left(k^0,-k^1,-k^2,-k^3\right),
\end{equation}
and so, up to a constant shift in time, in every Lorentzian
coordinate system we have
\begin{equation}\label{hoyuiouigyfghgjh3478zzrrZZffGGjkkjojj88iihhkhk88jklhkhjkhk8899jhjhjhkjhkjjjk}
t\,=\,\frac{1}{c}\,k^0x^0-\frac{1}{c}\left(k^1x^1+k^2x^2+k^3x^3\right).
\end{equation}
Note that in the unique cartesian Lorentzian coordinate system we
have $t=\frac{x^0}{c}$.
\end{definition}

\subsection{Certain curvilinear coordinate system in the case of
stationary radially symmetric gravitational field and relation to
the Schwarzschild metric}\label{ghghfhffgfggffgfg} Assume that for a
given part of the space in some inertial or non-inertial cartesian
coordinate system $(*)$ the gravitational field is stationary and
radially symmetric that means that the vectorial gravitational
potential $\vec v=(v_1,v_2,v_3)$ is independent of time variable $t$
and having the form
\begin{equation}\label{ggyfgdfdds}
\vec v(\vec x)=g\left(|\vec x|\right)\,\frac{\vec x}{|\vec
x|}\quad\quad\forall\,\vec x,
\end{equation}
for some scalar function $g(s):\mathbb{R}\to\mathbb{R}$. Next, given
some differentiable function $\Theta(\vec
x):\mathbb{R}^3\to\mathbb{R}$, consider the change of variables in
the four-dimensional space-time $\mathbb{R}^4$:
\begin{equation}\label{giuuihjghgghjgj78zzrrZZffhhhggygghghhjjknnKKMMghgh}
\begin{cases}
x'^0=x^0+\frac{\Theta\left(\left(x^1,x^2,x^3\right)\right)}{c}\\
x'^j=x^j\quad\quad\forall j=1,2,3.
\end{cases}
\end{equation}
that transforms the cartesian coordinate system $(*)$ to the
curvilinear coordinate system $(**)$ in the four-dimensional
space-time $\mathbb{R}^4$. Then in the terms of the
three-dimensional space and one-dimensional time:
\begin{equation}\label{fgjfjhgghjhjhjjhhjhh}
(x^0,x^1,x^2,x^3):=(ct,x_1,x_2,x_3)=\left(ct,\vec x\right),
\end{equation}
we rewrite \er{giuuihjghgghjgj78zzrrZZffhhhggygghghhjjknnKKMMghgh}
as:
\begin{equation}\label{giuuihjghgghjgj78zzrrZZffhhhggygghghhjjknnKKMM}
\begin{cases}
t'=t+\frac{\Theta(\vec x)}{c^2}\\
\vec x'=\vec x.
\end{cases}
\end{equation}
Note again, that since the new coordinate system $(**)$ in
$\mathbb{R}^4$ is \underline{curvilinear}, the time-like coordinate
$t'$ in coordinate system $(**)$ \underline{differ} from the proper
scalar of the global time.

Next if we define a matrix
\begin{equation}\label{giuuihjghg78zzrrZZffhhhgguyuyynnKKMM}
A=\left\{a^{i}_{j}\right\}_{0\leq i,j\leq 3}\in\mathbb{R}^{4\times
4}=\left\{\frac{\partial x'^i}{\partial x^j}\right\}_{0\leq i,j\leq
3}\in\mathbb{R}^{4\times 4},
\end{equation}
then
\begin{equation}\label{hoyuiouigyfg3478zzrrZZffhhhggjhhnnKKMM}
\begin{cases}
a^{0}_{0}=1
\\
a^{i}_{j}=\delta_{ij}\quad\forall 1\leq i,j\leq 3\\
a^{0}_{j}=\frac{1}{c}\frac{\partial \Theta}{\partial x^j}\quad\forall 1\leq j\leq 3\\
a^{j}_{0}=0\quad\forall 1\leq j\leq 3.
\end{cases}
\end{equation}
Next consider the contravariant pseudo-metric tensor of the
four-dimensional space-time $\{g^{ij}\}_{0\leq i,j\leq 3}$ that due
to \er{hoyuiouigyfghgjh3478zzrrZZffGGjkkj} has the form of
\begin{equation}\label{hoyuiouigyfg3478zzrrZZffhhhggughjttnnKKMM}
\begin{cases}
g^{00}=1\\
g^{ij}=-\delta_{ij}+\frac{v^iv^j}{c^2}\quad\forall 1\leq i,j\leq 3\\
g^{0j}=g^{j0}=\frac{v^j}{c}\quad\forall 1\leq j\leq 3,
\end{cases}
\end{equation}
in the cartesian coordinate system $(*)$. We would like to find the
form $\{g'^{ij}\}_{0\leq i,j\leq 3}$ of this tensor in the
curvilinear coordinate system $(**)$. Then by
\er{fgjfjhgghhgjghjhjkkkkgggh} we have:
\begin{equation}\label{giuuihjghg78zzrrZZffhhhgguyuyyolhjhjhjjjkttnnKKMM}
g'^{mn}=\sum_{i=0}^{3}\sum_{j=0}^{3}\left(a_{i}^{m}g^{ij}a_{j}^{n}\right)=\sum_{i=0}^{3}a_{i}^{m}\left(\sum_{j=0}^{3}g^{ij}a_{j}^{n}\right)\quad\quad\forall
0\leq m,n\leq 3.
\end{equation}
I.e.
\begin{equation}\label{giuuihjghg78zzrrZZffhhhgguyuyyolhjhjhjjjkjjjjttnnKKMM}
g'^{mn}=a_{0}^{m}\left(g^{00}a_{0}^{n}+\sum_{j=1}^{3}w^{0j}a_{j}^{n}\right)+\sum_{i=1}^{3}a_{i}^{m}\left(g^{i0}a_{0}^{n}+\sum_{j=1}^{3}g^{ij}a_{j}^{n}\right)
\quad\quad\forall 0\leq m,n\leq 3.
\end{equation}
In particular, by \er{hoyuiouigyfg3478zzrrZZffhhhggjhhnnKKMM} and
\er{giuuihjghg78zzrrZZffhhhgguyuyyolhjhjhjjjkjjjjttnnKKMM} we
obtain:
\begin{multline}\label{giuuihjghg78zzrrZZffhhhgguyuyyolhjhjhjjjkjjjjttkklnnKKMM}
g'^{00}=a_{0}^{0}\left(g^{00}a_{0}^{0}+\sum_{j=1}^{3}g^{0j}a_{j}^{0}\right)+\sum_{i=1}^{3}a_{i}^{0}\left(g^{i0}a_{0}^{0}+\sum_{j=1}^{3}g^{ij}a_{j}^{0}\right)\\
=a_{0}^{0}\left(a_{0}^{0}+\sum_{j=1}^{3}2\frac{v^{j}}{c}a_{j}^{0}\right)+\left(\sum_{j=1}^{3}a_{j}^{0}\frac{v^j}{c}\right)^2-
\sum_{j=1}^{3}\left(a_{j}^{0}\right)^2=\left(a_{0}^{0}+\sum_{j=1}^{3}\frac{v^{j}}{c}a_{j}^{0}\right)^2-
\sum_{j=1}^{3}\left(a_{j}^{0}\right)^2 ,
\end{multline}
\begin{multline}\label{giuuihjghg78zzrrZZffhhhgguyuyyolhjhjhjjjkjjjjttjhnnKKMM}
g'^{0n}=g'^{n0}=a_{0}^{0}\left(g^{00}a_{0}^{n}+\sum_{j=1}^{3}g^{0j}a_{j}^{n}\right)+\sum_{i=1}^{3}a_{i}^{0}\left(g^{i0}a_{0}^{n}+\sum_{j=1}^{3}g^{ij}a_{j}^{n}\right)\\
=a_{0}^{0}\frac{v^{n}}{c}-a_{n}^{0}+\sum_{i=1}^{3}a_{i}^{0}\frac{v^{i}}{c}\frac{v^{n}}{c}
\quad\quad\forall 1\leq n\leq 3,
\end{multline}
and
\begin{multline}\label{giuuihjghg78zzrrZZffhhhgguyuyyolhjhjhjjjkjjjjttiiujklnnKKMM}
g'^{mn}=a_{0}^{m}\left(g^{00}a_{0}^{n}+\sum_{j=1}^{3}g^{0j}a_{j}^{n}\right)+\sum_{i=1}^{3}a_{i}^{m}\left(g^{i0}a_{0}^{n}+\sum_{j=1}^{3}g^{ij}a_{j}^{n}\right)
=\frac{v^{m}}{c}\frac{v^{n}}{c}-\delta_{mn} \quad\quad\forall 1\leq
m,n\leq 3.
\end{multline}
Thus by three equations together with
\er{hoyuiouigyfg3478zzrrZZffhhhggjhhnnKKMM} we deduce:
\begin{equation}\label{giuuihjghg78zzrrZZffhhhgguyuyyolhjhjhjjjkjjjjttkklhjhjnnKKMM}
\begin{cases}
g'^{00}=\left(1
+\sum_{j=1}^{3}\frac{1}{c^2}v^{j}\frac{\partial \Theta}{\partial
x^j}\right)^2- \sum_{j=1}^{3}\frac{1}{c^2}\left(\frac{\partial
\Theta}{\partial x^j}\right)^2 \\
g'^{0n}=g'^{n0} =
\frac{v^n}{c}\left(1+
\sum_{j=1}^{3}\frac{1}{c^2}v^j\frac{\partial \Theta}{\partial
x^j}\right)-\frac{1}{c}\frac{\partial \Theta}{\partial x^n}
\quad\quad\forall 1\leq n\leq 3,\\
g'^{mn} =\frac{v^{m}}{c}\frac{v^{n}}{c}-\delta_{mn}
\quad\quad\forall 1\leq m,n\leq 3.
\end{cases}
\end{equation}
%
%
%
%
%
%
Next if $\vec v$ satisfies \er{ggyfgdfdds} then choosing the
function $\Theta$ to be defined as:
\begin{equation}\label{ghfgfgfdfddfddf}
\Theta(\vec x)=\xi\left(|\vec x|\right)\quad\forall\,\vec
x,\quad\quad\text{where}\quad\quad
\frac{d\xi}{ds}(s)=\frac{g(s)}{1-\frac{g^2(s)}{c^2}}\quad\quad\forall\,s,
\end{equation}
we find that:
\begin{equation}\label{giuuihjghg78zzrrZZffhhhgguyuyyolhjhjhjjjkjjjjttjhopiuiohgghhjhjnnKKkkkkKKkkKKMM}
v_n=\left(1+
\sum_{j=1}^{3}\frac{1}{c^2}v^j\frac{\partial \Theta}{\partial x^j}
\right)^{-1}\frac{\partial \Theta}{\partial x^n} \quad\quad\forall
1\leq n\leq 3,
\end{equation}
i.e.
\begin{equation}\label{giuuihjghg78zzrrZZffhhhgguyuyyolhjhjhjjjkjjjjttjhopiuiohgghhjhjnnKKkkkkKKMM}
v_n\left(1+
\sum_{j=1}^{3}\frac{1}{c^2}v^j\frac{\partial \Theta}{\partial x^j}
\right)=\frac{\partial \Theta}{\partial x^n} \quad\quad\forall 1\leq
n\leq 3.
\end{equation}
Then we rewrite
\er{giuuihjghg78zzrrZZffhhhgguyuyyolhjhjhjjjkjjjjttkklhjhjnnKKMM}
as:
\begin{equation}\label{giuuihjghg78zzrrZZffhhhgguyuyyolhjhjhjjjkjjjjttkklhjhjhghgikkjnnKK11KKMM}
\begin{cases}
g'^{00}=\left(1-\frac{|\vec
v|^2}{c^2}\right)\left(1
+\sum_{j=1}^{3}\frac{1}{c^2}v^j\frac{\partial \Theta}{\partial x^j}
\right)^2 ,
\\
g'^{0n}=g'^{n0} =0 \quad\quad\forall 1\leq n\leq 3,
\\
g'^{mn}=\frac{v^m}{c}\frac{v^n}{c}-\delta_{mn} \quad\quad\forall
1\leq m,n\leq 3.
\end{cases}
\end{equation}
On the other hand by
\er{giuuihjghg78zzrrZZffhhhgguyuyyolhjhjhjjjkjjjjttjhopiuiohgghhjhjnnKKkkkkKKMM}
we have
\begin{equation}\label{giuuihjghg78zzrrZZffhhhgguyuyyolhjhjhjjjkjjjjttjhopiuiohgghhjhjnnKKkkkkKKkkkKKKJKKMM}
|\vec v|^2
=\left(1-\frac{|\vec
v|^2}{c^2}\right)\left(\sum_{j=1}^{3}v^j\frac{\partial
\Theta}{\partial x^j}\right).
\end{equation}
We rewrite
\er{giuuihjghg78zzrrZZffhhhgguyuyyolhjhjhjjjkjjjjttjhopiuiohgghhjhjnnKKkkkkKKkkkKKKJKKMM}
as:
\begin{equation}\label{giuuihjghg78zzrrZZffhhhgguyuyyolhjhjhjjjkjjjjttjhopiuiohgghhjhjnnKKkkkkKKkkkKKKJKKjkjjMM}
1
=\left(1-\frac{|\vec v|^2}{c^2}\right)\left(1
+\sum_{j=1}^{3}\frac{1}{c^2}v^j\frac{\partial \Theta}{\partial
x^j}\right)
\end{equation}
Therefore, by
\er{giuuihjghg78zzrrZZffhhhgguyuyyolhjhjhjjjkjjjjttkklhjhjhghgikkjnnKK11KKMM}
and
\er{giuuihjghg78zzrrZZffhhhgguyuyyolhjhjhjjjkjjjjttjhopiuiohgghhjhjnnKKkkkkKKkkkKKKJKKjkjjMM}
we deduce:
\begin{equation}\label{giuuihjghg78zzrrZZffhhhgguyuyyolhjhjhjjjkjjjjttkklhjhjhghgikkjnnKK11KKhjhhjhjKKMM}
\begin{cases}
g'^{00}=
\left(1-\frac{|\vec v|^2}{c^2}\right)^{-1} ,
\\
g'^{0n}=g'^{n0} =0 \quad\quad\forall 1\leq n\leq 3,
\\
g'^{mn}=\frac{v^m}{c}\frac{v^n}{c}-\delta_{mn} \quad\quad\forall
1\leq m,n\leq 3.
\end{cases}
\end{equation}
Next we find that the \underline{covariant} pseudo-metric tensor
$\{g'_{ij}\}_{0\leq i,j\leq 3}$ in the curvilinear coordinate system
$(**)$ has the following form:
\begin{equation}\label{giuuihjghg78zzrrZZffhhhgguyuyyolhjhjhjjjkjjjjttkklhjhjhghgikkjnnKK11KKhjhhjhjKKkkkllkKKMM}
\begin{cases}
g'_{00}=
\left(1-\frac{|\vec v|^2}{c^2}\right),
\\
g'_{0n}=g'_{n0} =0 \quad\quad\forall 1\leq n\leq 3,
\\
g'_{mn}=-\left(\left(1-\frac{|\vec
v|^2}{c^2}\right)^{-1}\frac{v^m}{c}\frac{v^n}{c}+\delta_{mn}\right)
\quad\quad\forall 1\leq m,n\leq 3.
\end{cases}
\end{equation}
Indeed, if $\{g'_{ij}\}_{0\leq i,j\leq 3}$ is defined by
\er{giuuihjghg78zzrrZZffhhhgguyuyyolhjhjhjjjkjjjjttkklhjhjhghgikkjnnKK11KKhjhhjhjKKkkkllkKKMM},
then by
\er{giuuihjghg78zzrrZZffhhhgguyuyyolhjhjhjjjkjjjjttkklhjhjhghgikkjnnKK11KKhjhhjhjKKMM}
we have:
\begin{equation*}
\sum_{k=0}^{3}g'_{0k}g'^{k0}=g'_{00}g'^{00}+\sum_{k=1}^{3}g'_{0k}g'^{k0}=1
,
\end{equation*}
\begin{multline*}
\sum_{k=0}^{3}g'_{ik}g'^{kj}=g'_{i0}g'^{0j}+\sum_{k=1}^{3}g'_{ik}g'^{kj}=
\sum_{k=1}^{3}\left(\left(1-\frac{|\vec
v|^2}{c^2}\right)^{-1}\frac{v^i}{c}\frac{v^k}{c}+\delta_{ik}\right)\left(\delta_{kj}-\frac{v^k}{c}\frac{v^j}{c}\right)\\=\delta_{ij}
-\left(1-\frac{|\vec v|^2}{c^2}\right)^{-1}\frac{|\vec
v|^2}{c^2}\frac{v^i}{c}\frac{v^j}{c}-\frac{v^i}{c}\frac{v^j}{c}+\left(1-\frac{|\vec
v|^2}{c^2}\right)^{-1}\frac{v^i}{c}\frac{v^j}{c}=\delta_{ij}
\quad\forall 1\leq i,j\leq 3,
\end{multline*}
and
\begin{equation*}
\sum_{k=0}^{3}g'_{ik}g'^{k0}=g'_{i0}g'^{00}+\sum_{k=1}^{3}g'_{ik}g'^{k0}=0
\quad\forall 1\leq i\leq 3,
\end{equation*}
\begin{equation*}
\sum_{k=0}^{3}g'_{0k}g'^{kj}=g'_{00}g'^{0j}+\sum_{k=1}^{3}g'_{0k}g'^{kj}=0
\quad\forall 1\leq j\leq 3.
\end{equation*}
So
\begin{equation*}
\sum_{k=0}^{3}g'_{ik}g'^{kj}=\delta_{ij} \quad\forall 0\leq i,j\leq
3,
\end{equation*}
and thus equalities
\er{giuuihjghg78zzrrZZffhhhgguyuyyolhjhjhjjjkjjjjttkklhjhjhghgikkjnnKK11KKhjhhjhjKKkkkllkKKMM}
indeed define the covariant form of the pseudo-metric tensor. So by
\er{giuuihjghg78zzrrZZffhhhgguyuyyolhjhjhjjjkjjjjttkklhjhjhghgikkjnnKK11KKhjhhjhjKKkkkllkKKMM}
we have:
\begin{equation}\label{giuuihjghg78zzrrZZffhhhgguyuyyolhjhjhjjjkjjjjttkklhjhjhghgikkjnnKK11KKhjhhjhjKKkkkllkKKMMioii}
\begin{cases}
g'_{00}=
\left(1-\frac{|\vec v|^2}{c^2}\right),
\\
g'_{0n}=g'_{n0} =0 \quad\quad\forall 1\leq n\leq 3,
\\
g'_{mn}=-\left(\left(1-\frac{|\vec
v|^2}{c^2}\right)^{-1}\frac{v^m}{c}\frac{v^n}{c}+\delta_{mn}\right)
\quad\quad\forall 1\leq m,n\leq 3.
\end{cases}
\end{equation}
In particular, the quadratic form, induced by the covariant form of
the pseudo-metric tensor $\{g'_{ij}\}_{0\leq i,j\leq 3}$ in the
curvilinear coordinate system $(**)$, that defined on the tangent
vectors $\left(dx'^0,dx'^1,dx'^2,dx'^3\right)\in\mathbb{R}^4$ where
$d\vec x':=(dx'^1,dx'^2,dx'^3)$ has the following form:
\begin{multline}\label{ghfgfgdfdfddffd}
\sum_{i=0}^{3}\sum_{j=0}^{3}g'_{ij}dx'^idx'^j=
\left(1-\frac{|\vec v|^2}{c^2}\right)dx'^2_0-\left(|d\vec
x'|^2+\frac{1}{c^2}\left(1-\frac{|\vec
v|^2}{c^2}\right)^{-1}\left|\vec v\cdot d\vec x'\right|^2\right)=\\
\left(1-\frac{|\vec v|^2}{c^2}\right)dx'^2_0-\left(\left(|d\vec
x'|^2-\left|\frac{\vec v}{|\vec v|}\cdot d\vec
x'\right|^2\right)+\left(1-\frac{|\vec
v|^2}{c^2}\right)^{-1}\frac{|\vec v|^2}{c^2}\,\left|\frac{\vec
v}{|\vec v|}\cdot d\vec x'\right|^2+\left|\frac{\vec v}{|\vec
v|}\cdot d\vec x'\right|^2\right)\\=
\left(1-\frac{|\vec
v|^2}{c^2}\right)dx'^2_0-\left(\left(1-\frac{|\vec
v|^2}{c^2}\right)^{-1}\left|\frac{\vec v}{|\vec v|}\cdot d\vec
x'\right|^2+\left(|d\vec x'|^2-\left|\frac{\vec v}{|\vec v|}\cdot
d\vec x'\right|^2\right)\right).
\end{multline}
Thus taking into account
\er{giuuihjghgghjgj78zzrrZZffhhhggygghghhjjknnKKMM} and
\er{ggyfgdfdds} we rewrite \er{ghfgfgdfdfddffd} as:
\begin{multline}\label{ghfgfgdfdfddffdhhjhj}
\sum_{i=0}^{3}\sum_{j=0}^{3}g'_{ij}dx'^idx'^j=\\
\left(1-\frac{\left|\vec v(\vec
x')\right|^2}{c^2}\right)dx'^2_0-\left(\left(1-\frac{\left|\vec
v(\vec x')\right|^2}{c^2}\right)^{-1}\left|\frac{\vec x'}{|\vec
x'|}\cdot d\vec x'\right|^2+\left(|d\vec x'|^2-\left|\frac{\vec
x'}{|\vec x'|}\cdot d\vec x'\right|^2\right)\right).
\end{multline}

Next, up to the end of this subsection, assume that our cartesian
coordinate system $(*)$ is non-rotating and our gravitational field
is formed by the spherical symmetric massive body of mass $m_0$ and
radius $R_0$ like the Earth, the Sun et.al. with the center at the
point $0$. Then, as we get either in
\er{MaxVacFull1ninshtrgravortghhghgjkgghklhjgkghghjjkjhjkkggjkhjkhjjhhfhjhklkhkhjjklzzzyyyhjggjhgghhjhNWNWBWHWPPN222kkkgtytghjjhpoppkkkhhhkhjhj}
and
\er{MaxVacFull1ninshtrgravortghhghgjkgghklhjgkghghjjkjhjkkggjkhjkhjjhhfhjhklkhkhjjklzzzyyyhjggjhgghhjhNWNWBWHWPPN222kkkgtytghjjhpoppkkkhhhkhjhjiuu}
of Remark \ref{gygygygyggjh} in the case of the Newtonian gravity,
or in
\er{MaxVacFull1ninshtrgravortghhghgjkgghklhjgkghghjjkjhjkkggjkhjkhjjhhfhjhklkhkhjjklzzzyyyhjggjhgghhjhNWNWBWHWPPN222kkkgtytghjjhpoppkkkhhhkhjhj34}
and
\er{MaxVacFull1ninshtrgravortghhghgjkgghklhjgkghghjjkjhjkkggjkhjkhjjhhfhjhklkhkhjjklzzzyyyhjggjhgghhjhNWNWBWHWPPN222kkkgtytghjjhpoppkkkhhhkhjhjiuu34}
of Remark \ref{ghjfhgfdhdnw22mm} in the case of the more general
gravity model, given by \er{guigjgjffghguygjyfDDnw22mm} with
arbitrary constant $\beta$, we have: either
\begin{equation}
\label{MaxVacFull1ninshtrgravortghhghgjkgghklhjgkghghjjkjhjkkggjkhjkhjjhhfhjhklkhkhjjklzzzyyyhjggjhgghhjhNWNWBWHWPPN222kkkgtytghjjhpoppkkkhhhkhjhjhjhjj}
\vec v(\vec x)= \frac{\sqrt{-2\Phi_1(|\vec x|)}}{|\vec x|}\vec x,
\end{equation}
or
\begin{equation}
\label{MaxVacFull1ninshtrgravortghhghgjkgghklhjgkghghjjkjhjkkggjkhjkhjjhhfhjhklkhkhjjklzzzyyyhjggjhgghhjhNWNWBWHWPPN222kkkgtytghjjhpoppkkkhhhkhjhjiuuiuhhj}
\vec v(\vec x)= -\frac{\sqrt{-2\Phi_1(|\vec x|)}}{|\vec x|}\vec x,
\end{equation}
where $\Phi_1$ is the classical Newtonian potential of our massive
body $m_0$ that satisfies
\begin{equation}\label{ghgffgfgfgf}
\Phi_1(\vec x)=-\frac{Gm_0}{|\vec x|}
\end{equation}
outside of the body surface. Thus in particular,
\begin{equation}\label{ghgffgfgfgfbbbojkjk}
\left|\vec v(\vec x)\right|^2=-2\Phi_1(|\vec x|),
\end{equation}
and outside of the massive body surface we have:
\begin{equation}\label{ghgffgfgfgfbbb}
\left|\vec v(\vec x)\right|^2=\frac{2Gm_0}{|\vec x|}.
\end{equation}
Both
\er{MaxVacFull1ninshtrgravortghhghgjkgghklhjgkghghjjkjhjkkggjkhjkhjjhhfhjhklkhkhjjklzzzyyyhjggjhgghhjhNWNWBWHWPPN222kkkgtytghjjhpoppkkkhhhkhjhjhjhjj}
and
\er{MaxVacFull1ninshtrgravortghhghgjkgghklhjgkghghjjkjhjkkggjkhjkhjjhhfhjhklkhkhjjklzzzyyyhjggjhgghhjhNWNWBWHWPPN222kkkgtytghjjhpoppkkkhhhkhjhjiuuiuhhj}
are particular cases of \er{ggyfgdfdds}, with
\begin{equation}
\label{MaxVacFull1ninshtrgravortghhghgjkgghklhjgkghghjjkjhjkkggjkhjkhjjhhfhjhklkhkhjjklzzzyyyhjggjhgghhjhNWNWBWHWPPN222kkkgtytghjjhpoppkkkhhhkhjhjiuuiuhhjjkkkmn}
g(s)=\pm\sqrt{-2\Phi_1(s)},
\end{equation}
and in particular, outside of the massive body surface we have:
\begin{equation}
\label{MaxVacFull1ninshtrgravortghhghgjkgghklhjgkghghjjkjhjkkggjkhjkhjjhhfhjhklkhkhjjklzzzyyyhjggjhgghhjhNWNWBWHWPPN222kkkgtytghjjhpoppkkkhhhkhjhjiuuiuhhjjkkkmnhjjh}
g\left(|x|\right)=\pm\sqrt{\frac{2Gm_0}{|\vec x|}},
\end{equation}
Thus defining the function $\Theta(\vec x)$ as in
\er{ghfgfgfdfddfddf}, that always can be done in the case
$\frac{2Gm_0}{R_0}<c^2$, we can define the change of variables from
coordinate system $(*)$ to the curvilinear coordinate system $(**)$
in the four-dimensional space-time $\mathbb{R}^4$ as in
\er{giuuihjghgghjgj78zzrrZZffhhhggygghghhjjknnKKMM}:
\begin{equation}\label{giuuihjghgghjgj78zzrrZZffhhhggygghghhjjknnKKMMih}
\begin{cases}
t'=t+\frac{\Theta(\vec x)}{c^2}\\
\vec x'=\vec x.
\end{cases}
\end{equation}
Then by inserting
\er{MaxVacFull1ninshtrgravortghhghgjkgghklhjgkghghjjkjhjkkggjkhjkhjjhhfhjhklkhkhjjklzzzyyyhjggjhgghhjhNWNWBWHWPPN222kkkgtytghjjhpoppkkkhhhkhjhjhjhjj}
or
\er{MaxVacFull1ninshtrgravortghhghgjkgghklhjgkghghjjkjhjkkggjkhjkhjjhhfhjhklkhkhjjklzzzyyyhjggjhgghhjhNWNWBWHWPPN222kkkgtytghjjhpoppkkkhhhkhjhjiuuiuhhj}
into
\er{giuuihjghg78zzrrZZffhhhgguyuyyolhjhjhjjjkjjjjttkklhjhjhghgikkjnnKK11KKhjhhjhjKKkkkllkKKMMioii}
we deduce the form of the covariant pseudo-metric tensor in the
curvilinear coordinate system $(**)$:
\begin{equation}\label{giuuihjghg78zzrrZZffhhhgguyuyyolhjhjhjjjkjjjjttkklhjhjhghgikkjnnKK11KKhjhhjhjKKkkkllkKKMMioiijjjkjk}
\begin{cases}
g'_{00}=
\left(1+\frac{2\Phi_1(|\vec x'|)}{c^2}\right),
\\
g'_{0n}=g'_{n0} =0 \quad\quad\forall 1\leq n\leq 3,
\\
g'_{mn}=\left(\left(1+\frac{2\Phi_1(|\vec
x'|)}{c^2}\right)^{-1}\frac{2\Phi_1(|\vec
x'|)}{c^2}\frac{x'_m}{|\vec x'|}\frac{x'_n}{|\vec
x'|}-\delta_{mn}\right) \quad\quad\forall 1\leq m,n\leq 3.
\end{cases}
\end{equation}
Moreover, by \er{ghfgfgdfdfddffdhhjhj} we have:
\begin{multline}\label{ghfgfgdfdfddffdhhjhjuiuuui}
\sum_{i=0}^{3}\sum_{j=0}^{3}g'_{ij}dx'^idx'^j=\\
\left(1+\frac{2\Phi_1(|\vec
x'|)}{c^2}\right)dx'^2_0-\left(\left(1+\frac{2\Phi_1(|\vec
x'|)}{c^2}\right)^{-1}\left|\frac{\vec x'}{|\vec x'|}\cdot d\vec
x'\right|^2+\left(|d\vec x'|^2-\left|\frac{\vec x'}{|\vec x'|}\cdot
d\vec x'\right|^2\right)\right).
\end{multline}
In particular, outside of the massive body surface, i.e. when
$|x'|>R_0$ we rewrite
\er{giuuihjghg78zzrrZZffhhhgguyuyyolhjhjhjjjkjjjjttkklhjhjhghgikkjnnKK11KKhjhhjhjKKkkkllkKKMMioiijjjkjk}
and \er{ghfgfgdfdfddffdhhjhjuiuuui} as:
\begin{equation}\label{giuuihjghg78zzrrZZffhhhgguyuyyolhjhjhjjjkjjjjttkklhjhjhghgikkjnnKK11KKhjhhjhjKKkkkllkKKMMioiijjjkjkuyuh}
\begin{cases}
g'_{00}=
\left(1-\frac{2Gm_0}{c^2|\vec x'|}\right),
\\
g'_{0n}=g'_{n0} =0 \quad\quad\forall 1\leq n\leq 3,
\\
g'_{mn}=-\left(\left(1-\frac{2Gm_0}{c^2|\vec
x'|}\right)^{-1}\frac{2Gm_0}{c^2|\vec x'|}\frac{x'_m}{|\vec
x'|}\frac{x'_n}{|\vec x'|}+\delta_{mn}\right) \quad\quad\forall
1\leq m,n\leq 3,
\end{cases}
\end{equation}
and
\begin{multline}\label{ghfgfgdfdfddffdhhjhjuiuuuikkj}
\sum_{i=0}^{3}\sum_{j=0}^{3}g'_{ij}dx'^idx'^j=\\
\left(1-\frac{2Gm_0}{c^2|\vec
x'|}\right)dx'^2_0-\left(\left(1-\frac{2Gm_0}{c^2|\vec
x'|}\right)^{-1}\left|\frac{\vec x'}{|\vec x'|}\cdot d\vec
x'\right|^2+\left(|d\vec x'|^2-\left|\frac{\vec x'}{|\vec x'|}\cdot
d\vec x'\right|^2\right)\right).
\end{multline}
Therefore, we get that in coordinate system $(**)$, outside of the
massive body, the covariant pseudo-metric tensor in
\er{giuuihjghg78zzrrZZffhhhgguyuyyolhjhjhjjjkjjjjttkklhjhjhghgikkjnnKK11KKhjhhjhjKKkkkllkKKMMioiijjjkjkuyuh}
and \er{ghfgfgdfdfddffdhhjhjuiuuuikkj} \underline{exactly} the same
as the well known Schwarzschild metric from the General Relativity.
Indeed in the spherical coordinates in $\mathbb{R}^3$ we rewrite
\er{ghfgfgdfdfddffdhhjhjuiuuuikkj} as:
\begin{multline}\label{ghfgfgdfdfddffdhhjhjuiuuuikkjjhhjhj}
\sum_{i=0}^{3}\sum_{j=0}^{3}g'_{ij}dx'^idx'^j=\\
\left(1-\frac{2Gm_0}{c^2r'}\right)dx'^2_0-\left(\left(1-\frac{2Gm_0}{c^2r'}\right)^{-1}\left(dr'\right)^2+(r')^2\left((d\theta')^2+\sin^2{\left(\theta'\right)}(d\varphi')^2\right)\right),
\end{multline}
and this is exactly the classical Schwarzschild metric.

In particular, if we consider the monochromatic electromagnetic wave
of frequency $\omega$ of the form $e^{i\omega t}U(\vec x)$ in the
coordinate system $(*)$, then by
\er{giuuihjghgghjgj78zzrrZZffhhhggygghghhjjknnKKMMih} in the
coordinate system $(**)$ the form of this light is $e^{i\omega
t'}U'(\vec x')$ where $U'(\vec x')=U(\vec
x')e^{-i\omega\frac{\Theta(\vec x')}{c^2}}$, i.e the electromagnetic
wave in the coordinate system  $(**)$ is also monochromatic of the
same frequency $\omega$. Thus all the optical effects that we find
in the frames of our model coincides with the effects considered in
the frames of General Relativity for the Schwarzschild metric. In
particular, the Michelson-Morely experiment and all Sagnac-type
effects will lead to the same result in the frame of our model like
in the case of the General relativity. Moreover, since the Maxwell
equations in both models have the same tensor form, all the
electromagnetic effects, where the time does not appear explicitly
will be the same. Similarly, the curvature of the light path in the
Sun's gravitational field will be the same in both models. Finally,
in the particular case of $\mathcal{G}(\tau)=\sqrt{\tau}$ in
\er{btfffygtgyggyijhhkkSYSPNuhuygyygyggyuyyuyuykuhghgjjojhjfg}, i.e.
in the case of the relativistic-like Lagrangian of the motion in
\er{btfffygtgyggyijhhkkSYSPNuhuygyygyggyuyyuyuykuhghgjjojiyyyuyuuyy}
all the mechanical effects will be the same in the frame of our
model like in the case of the General relativity  for the
Schwarzschild metric, provided that the time does not appear
explicitly in this effects. In particular, the movement of the
Mercury planet in the Sun's gravitational field will be the same in
both models, provided we take into account the relativistic-like
Lagrangian of the motion as in
\er{btfffygtgyggyijhhkkSYSPNuhuygyygyggyuyyuyuykuhghgjjojiyyyuyuuyy}.

Finally, note that the similar, solution as in
\er{ghfgfgdfdfddffdhhjhjuiuuuikkj} or
\er{ghfgfgdfdfddffdhhjhjuiuuuikkjjhhjhj} is valid also for the
general laws of the gravity, given by either
\er{MaxVacFull1bjkgjhjhgjaaahkjhhjzzzyyykkknnnNWBWHWPPNNewCCmmhhjjhkkk}
or \er{guigjgjffghguygjyfDDnw22mm34}, where in all places we take
$M_0$ instead of $m_0$ and $\Phi_1(\vec x)=-\frac{GM_0}{|\vec x|}$
instead of \er{ghgffgfgfgf}, with $M_0$ being the total effective
\underline{gravitational} mass of the Earth, defined as in
\er{MaxVacFull1ninshtrgravortghhghgjkgghklhjgkghghjjkjhjkkggjkhjkhjjhhfhjhklkhkhjjklzzzyyyhjggjhgghhjhNWNWBWHWPPN222kkkgtytghjjhpoppkkkhhhk3499jhikjhjhj99jkkh}
(see subsection \ref{ghjfhgfdhdnw22mm34} for the details).

\subsection{General Lagraingian of the gravitational-electromagnetic
field, compatible with the general Lagrangian of the motion in
\er{btfffygtgyggyijhhkkSYSPNuhuygyygyggyuyyuyuykuhghgjjoj}}\label{hughjughghffghfgffg1}

\subsubsection{The case of the Newtonian-type gravity}\label{hughjughghffghfgffg}
Given known the distribution of inertial mass density of some
continuum medium $\mu:=\mu(\vec x,t)$, the field of velocities of
this medium $\vec u:=\vec u(\vec x,t)$, the charge density
$\rho:=\rho(\vec x,t)$ and the current density $\vec j:=\vec j(\vec
x,t)$ in a cartesian coordinate system,  consider a general
Lagrangian density $L$ of the unified gravitational-electromagnetic
field that generalize the Lagrangian density defined by
\er{vhfffngghkjgghDD} and is consistent with the Lagrangian of the
motion of particles of the general form
\er{btfffygtgyggyijhhkkSYSPNuhuygyygyggyuyyuyuykuhghgjjoj}:
\begin{multline}\label{vhfffngghkjgghDDmmjj}
L\left(\vec A,\Psi,\vec v,\Phi,\vec p,\vec
x,t\right):=\frac{1}{8\pi}\left|-\nabla_{\vec
x}\Psi-\frac{1}{c}\frac{\partial\vec A}{\partial t}+\frac{1}{c}\vec
v\times curl_{\vec x}\vec A\right|^2-\frac{1}{8\pi}\left|curl_{\vec
x}\vec A\right|^2-\left(\rho\Psi-\frac{1}{c}\vec A\cdot\vec
j\right)\\-\mu c^2 \mathcal{G}\left(1-\frac{1}{c^2}\left|\vec u-\vec
v\right|^2\right)+
\frac{1}{2}\left(d_{\vec x}\vec v+\left\{d_{\vec x}\vec
v\right\}^T\right):\left(d_{\vec x}\vec p+\left\{d_{\vec x}\vec
p\right\}^T\right)-2\left(div_{\vec x}\vec v\right)\left(div_{\vec
x}\vec p\right)\\+\frac{1}{4\pi G}\left(div_{\vec x}\vec
v\right)\left(\frac{\partial \Phi}{\partial t}+\vec
v\cdot\nabla_{\vec x} \Phi\right)+\frac{1}{4\pi
G}\Phi\left(div_{\vec x}\vec v\right)^2-\frac{\Phi}{16\pi
G}\left|d_{\vec x}\vec v+\left\{d_{\vec x}\vec
v\right\}^T\right|^2+\frac{1}{8\pi G}\left|\nabla_{\vec
x}\Phi\right|^2,
\end{multline}
where $\mathcal{G}(s):\mathbb{R}\to\mathbb{R}$ is a given function,
$\Phi$ is some ancillary proper scalar field and $\vec p$ is some
ancillary proper vector field. Then $L$ is invariant under the
change of inertial or non-inertial cartesian coordinate system of
the form \er{noninchgravortbstrjgghguittu2intrrrZZygjyg}. Then
denoting the function
\begin{equation}\label{ujghjjghjh}
g(s):=-c^2\mathcal{G}\left(1-\frac{2s}{c^2}\right)\quad\quad\forall
s
\end{equation}
we rewrite \er{vhfffngghkjgghDDmmjj} as:
\begin{multline}\label{vhfffngghkjgghDDmm}
L\left(\vec A,\Psi,\vec v,\Phi,\vec p,\vec
x,t\right):=\frac{1}{8\pi}\left|-\nabla_{\vec
x}\Psi-\frac{1}{c}\frac{\partial\vec A}{\partial t}+\frac{1}{c}\vec
v\times curl_{\vec x}\vec A\right|^2-\frac{1}{8\pi}\left|curl_{\vec
x}\vec A\right|^2-\left(\rho\Psi-\frac{1}{c}\vec A\cdot\vec
j\right)\\+\mu g\left(\frac{1}{2}\left|\vec u-\vec
v\right|^2\right)+
\frac{1}{2}\left(d_{\vec x}\vec v+\left\{d_{\vec x}\vec
v\right\}^T\right):\left(d_{\vec x}\vec p+\left\{d_{\vec x}\vec
p\right\}^T\right)-2\left(div_{\vec x}\vec v\right)\left(div_{\vec
x}\vec p\right)\\+\frac{1}{4\pi G}\left(div_{\vec x}\vec
v\right)\left(\frac{\partial \Phi}{\partial t}+\vec
v\cdot\nabla_{\vec x} \Phi\right)+\frac{1}{4\pi
G}\Phi\left(div_{\vec x}\vec v\right)^2-\frac{\Phi}{16\pi
G}\left|d_{\vec x}\vec v+\left\{d_{\vec x}\vec
v\right\}^T\right|^2+\frac{1}{8\pi G}\left|\nabla_{\vec
x}\Phi\right|^2.
\end{multline}
We point out two the most important choices of function
$\mathcal{G}(s)$: fully non-relativistic choice
$\mathcal{G}(s)=\frac{s}{2}$ and correspondingly
$g(s)=\left(s-\frac{c^2}{2}\right)$; and relativistic-like choice
$\mathcal{G}(s)=\sqrt{s}$ and correspondingly
$g(s):=-c^2\sqrt{1-\frac{2s}{c^2}}$. Note also that in the first
case we have $\frac{dg}{ds}(s)=1$ and in the second case
$\frac{dg}{ds}(s)=\left(1-\frac{2s}{c^2}\right)^{-\frac{1}{2}}\approx
1$, where the last equation is valid in the case where $2s\ll c^2$.

We investigate stationary points of the functional
\begin{equation}\label{btfffygtgyggyDDmm}
J=\int_0^T\int_{\mathbb{R}^3}L\left(\vec A,\Psi,\vec v,\Phi,\vec
p,\vec x,t\right)d\vec x dt.
\end{equation}
We denote
\begin{equation}\label{guigjgjffghDDmm}
\begin{cases}
\vec D=-\nabla_{\vec x}\Psi-\frac{1}{c}\frac{\partial\vec
A}{\partial t}+\frac{1}{c}\vec
v\times curl_{\vec x}\vec A\\
\vec B=curl_{\vec x}\vec A
\\
\vec E=-\nabla_{\vec x}\Psi-\frac{1}{c}\frac{\partial\vec A}{\partial t}=\vec D-\frac{1}{c}\vec v\times\vec B\\
\vec H=curl_{\vec x}\vec A+\frac{1}{c}\vec
v\times\left(-\nabla_{\vec x}\Psi-\frac{1}{c}\frac{\partial\vec
A}{\partial t}+\frac{1}{c}\vec v\times curl_{\vec x}\vec
A\right)=\vec B+\frac{1}{c}\vec v\times\vec D.
\end{cases}
\end{equation}
Then by \er{guigjgjffghDDmm} we have:
\begin{equation}\label{guigjgjffghjhkkgDDmm}
\begin{cases}
curl_{\vec x}\vec E+\frac{1}{c}\frac{\partial\vec B}{\partial t}=0\\
div_{\vec x}\vec B=0.
\end{cases}
\end{equation}
Moreover by \er{vhfffngghkjgghDDmm}, \er{apfrm12}
and \er{apfrm6} we
have
\begin{equation}\label{vhfffngghkjgghggtghjgfhjyuiutDDmm}
\frac{\delta L}{\delta \vec p}=-div_{\vec x}\left(d_{\vec x}\vec
v+\left\{d_{\vec x}\vec v\right\}^T\right)+2\nabla_{\vec
x}\left(div_{\vec x}\vec v\right)=curl_{\vec x}\left(curl_{\vec
x}\vec v\right)=0,
\end{equation}
\begin{equation}\label{vhfffngghkjgghggtghjgfhjDDmm}
\frac{\delta L}{\delta \Phi}=-\frac{1}{4\pi
G}\left(\frac{\partial}{\partial t}\left\{div_{\vec x}\vec
v\right\}+\vec v\cdot\nabla_{\vec x}\left(div_{\vec x}\vec
v\right)+\frac{1}{4}\left|d_{\vec x}\vec v+\left\{d_{\vec x}\vec
v\right\}^T\right|^2\right)-\frac{1}{4\pi G}\Delta_{\vec x}\Phi=0,
\end{equation}
\begin{multline}\label{vhfffngghkjgghggtghjgfhjoyuiyuDDmm}
\frac{\delta L}{\delta \vec v}=-\left(\mu
g'\left(\frac{1}{2}\left|\vec u-\vec v\right|^2\right)(\vec u-\vec
v)+\frac{1}{4\pi c}\vec D\times\vec B\right)-div_{\vec
x}\left(d_{\vec x}\vec p+\left\{d_{\vec x}\vec
p\right\}^T\right)+2\nabla_{\vec x}\left(div_{\vec x}\vec
p\right)\\+\frac{1}{4\pi G}div_{\vec x}\left\{\left(d_{\vec x}\vec
v+\left\{d_{\vec x}\vec v\right\}^T\right)\Phi\right\}-\frac{1}{2\pi
G}\nabla_{\vec x}\left(\Phi\left(div_{\vec x}\vec
v\right)\right)-\frac{1}{4\pi G}\nabla_{\vec x}\left(\frac{\partial
\Phi}{\partial t}+\vec v\cdot\nabla_{\vec x}
\Phi\right)\\+\frac{1}{4\pi G}\left(div_{\vec x}\vec
v\right)\nabla_{\vec x}\Phi=-\left(\mu g'\left(\frac{1}{2}\left|\vec
u-\vec v\right|^2\right)(\vec u-\vec v)+\frac{1}{4\pi c}\vec
D\times\vec B\right)+curl_{\vec x}\left(curl_{\vec x}\vec
p\right)\\-\frac{1}{4\pi G}\Phi\,curl_{\vec x}\left(curl_{\vec
x}\vec v\right)-\frac{1}{4\pi G}\left(\frac{\partial}{\partial
t}\left(\nabla_{\vec x}\Phi\right)-curl_{\vec x}\left(\vec
v\times\nabla_{\vec x}\Phi\right)+\left(\Delta_{\vec
x}\Phi\right)\vec v\right)=0,
\end{multline}
\begin{equation}\label{vhfffngghkjgghggtghjgfhjhjkghghDDmm}
\frac{\delta L}{\delta \Psi}=\frac{1}{4\pi}div_{\vec x}\vec
D-\rho=0,
\end{equation}
and
\begin{equation}\label{vhfffngghkjgghggtghjgfhjhjkghghyuiuuDDmm}
\frac{\delta L}{\delta \vec A}=\frac{1}{c}\vec j+\frac{1}{4\pi
c}\frac{\partial\vec D}{\partial t}-\frac{1}{4\pi}curl_{\vec x}\vec
B-\frac{1}{4\pi c}curl_{\vec x}\left(\vec v\times \vec
D\right)=\frac{1}{c}\vec j+\frac{1}{4\pi c}\frac{\partial\vec
D}{\partial t}-\frac{1}{4\pi}curl_{\vec x}\vec H=0.
\end{equation}
So
\begin{equation}\label{guigjgjffghguygjyfDDmm}
\begin{cases}
curl_{\vec x}\vec H=\frac{4\pi}{c}\vec
j+\frac{1}{c}\frac{\partial\vec D}{\partial
t}\\
div_{\vec x}\vec D=4\pi\rho\\
curl_{\vec x}\vec E+\frac{1}{c}\frac{\partial\vec B}{\partial t}=0\\
div_{\vec x}\vec B=0\\
\vec E=\vec D-\frac{1}{c}\vec v\times\vec B\\
\vec H=\vec B+\frac{1}{c}\vec v\times\vec D\\
curl_{\vec x}\left(curl_{\vec x}\vec v\right)=0
\\
\frac{\partial}{\partial t}\left\{div_{\vec x}\vec v\right\}+\vec
v\cdot\nabla_{\vec x}\left(div_{\vec x}\vec
v\right)+\frac{1}{4}\left|d_{\vec x}\vec v+\left\{d_{\vec x}\vec
v\right\}^T\right|^2=-\Delta_{\vec x}\Phi\\
\left(\mu g'\left(\frac{1}{2}\left|\vec u-\vec
v\right|^2\right)(\vec u-\vec v)+\frac{1}{4\pi c}\vec D\times\vec
B\right)\\=curl_{\vec x}\left(curl_{\vec x}\vec
p\right)-\frac{1}{4\pi G}\left(\frac{\partial}{\partial
t}\left(\nabla_{\vec x}\Phi\right)-curl_{\vec x}\left(\vec
v\times\nabla_{\vec x}\Phi\right)+\left(\Delta_{\vec
x}\Phi\right)\vec v\right).
\end{cases}
\end{equation}

Next consider the equations of the gravitational-electromagnetic
field in the form \er{guigjgjffghguygjyfDDmm}. Then defining the
gravitational mass
\begin{equation}\label{Mjghjghhjg}
M:=\frac{1}{4\pi G}\Delta_{\vec x}\Phi
\end{equation} we obtain
\begin{equation}\label{MaxVacFull1bjkgjhjhgjaaahkjhhjzzzyyykkknnnNWBWHWPPNNewCCmmhhjjh}
\begin{cases}
curl_{\vec x}\vec H=\frac{4\pi}{c}\vec
j+\frac{1}{c}\frac{\partial\vec D}{\partial
t},\\
div_{\vec x}\vec D=4\pi\rho,\\
curl_{\vec x}\vec E+\frac{1}{c}\frac{\partial\vec B}{\partial t}=0,\\
div_{\vec x}\vec B=0,\\
\vec E=\vec D-\frac{1}{c}\vec v\times\vec B,\\
\vec H=\vec B+\frac{1}{c}\vec v\times\vec D,\\
\frac{\partial}{\partial t}\left\{div_{\vec x}\vec v\right\}+\vec
v\cdot\nabla_{\vec x}\left(div_{\vec x}\vec
v\right)+\frac{1}{4}\left|d_{\vec x}\vec v+\left\{d_{\vec x}\vec
v\right\}^T\right|^2=-4\pi GM,\\
 curl_{\vec
x}\left(curl_{\vec x}\vec v\right)=0,
\\
\frac{\partial M}{\partial t}+div_{\vec x}\left\{M\vec v+\mu
g'\left(\frac{1}{2}\left|\vec u-\vec v\right|^2\right)(\vec u-\vec
v)+\frac{1}{4\pi c}\vec D\times\vec B\right\}=0.
\end{cases}
\end{equation}
Using the continuum equation
\begin{equation}\label{MaxVacFull1ninshtrgravortghhghgjkgghklhjgkghghjjkjhjkkggjkhjkhjjhhfhjhkhjjkhjgzzzyyykkknnnNWBWHWPPNNewZZZCCmmhjhj}
\frac{\partial\mu}{\partial t}+div_{\vec x}\left(\mu\vec u\right)=0.
\end{equation}
we rewrite
\er{MaxVacFull1bjkgjhjhgjaaahkjhhjzzzyyykkknnnNWBWHWPPNNewCCmmhhjjh}
as
\begin{equation}\label{MaxVacFull1bjkgjhjhgjaaahkjhhjzzzyyykkknnnNWBWHWPPNNewCCmmhhjjhkkk}
\begin{cases}
curl_{\vec x}\vec H=\frac{4\pi}{c}\vec
j+\frac{1}{c}\frac{\partial\vec D}{\partial
t},\\
div_{\vec x}\vec D=4\pi\rho,\\
curl_{\vec x}\vec E+\frac{1}{c}\frac{\partial\vec B}{\partial t}=0,\\
div_{\vec x}\vec B=0,\\
\vec E=\vec D-\frac{1}{c}\vec v\times\vec B,\\
\vec H=\vec B+\frac{1}{c}\vec v\times\vec D,\\
\frac{\partial}{\partial t}\left\{div_{\vec x}\vec v\right\}+\vec
v\cdot\nabla_{\vec x}\left(div_{\vec x}\vec
v\right)+\frac{1}{4}\left|d_{\vec x}\vec v+\left\{d_{\vec x}\vec
v\right\}^T\right|^2=-4\pi GM,\\
 curl_{\vec
x}\left(curl_{\vec x}\vec v\right)=0,
\\
\frac{\partial}{\partial t}\left(M-\mu\right)+div_{\vec
x}\left\{\left(M-\mu\right)\vec v\right\}=-div_{\vec x}\left\{\mu
\left(g'\left(\frac{1}{2}\left|\vec u-\vec
v\right|^2\right)-1\right)(\vec u-\vec v)+\frac{1}{4\pi c}\vec
D\times\vec B\right\}\,.
\end{cases}
\end{equation}
Note again for the last equation in
\er{MaxVacFull1bjkgjhjhgjaaahkjhhjzzzyyykkknnnNWBWHWPPNNewCCmmhhjjhkkk}
that: in the fully non-relativistic case we have $g'(s)=1$ and in
the relativistic-like case we have
$g'(s)=\left(1-\frac{2s}{c^2}\right)^{-\frac{1}{2}}\approx 1$, where
the last equation is valid in the case where $2s\ll c^2$.

\subsubsection{The case of some possible
alternative model of the gravity}\label{ghjfhgfdhdnw22mm34} Consider
$\vec k$ to be the vectorial potential of the inertia, which is a
generally trivial speed-like vector field, assumed to be fixed in
every fixed inertial or non-inertial cartesian coordinate system
(see Definition \ref{jjhgyftdtd}). Given known the distribution of
inertial mass density of some continuum medium $\mu:=\mu(\vec x,t)$,
the field of velocities of this medium $\vec u:=\vec u(\vec x,t)$,
the charge density $\rho:=\rho(\vec x,t)$ and the current density
$\vec j:=\vec j(\vec x,t)$ consider a Lagrangian density $L$ defined
by
\begin{multline}\label{vhfffngghkjgghDDnw22mm34mm34}
L\left(\vec A,\Psi,\vec v,\Phi_0,\vec h,\vec x,t\right):=\\
\frac{1}{8\pi}\left|-\nabla_{\vec
x}\Psi-\frac{1}{c}\frac{\partial\vec A}{\partial t}+\frac{1}{c}\vec
v\times curl_{\vec x}\vec A\right|^2-\frac{1}{8\pi}\left|curl_{\vec
x}\vec A\right|^2-\left(\rho\Psi-\frac{1}{c}\vec A\cdot\vec j\right)
-\mu c^2 \mathcal{G}\left(1-\frac{1}{c^2}\left|\vec u-\vec
v\right|^2\right)
\\
-\frac{c^2}{8\pi G}\left|-\nabla_{\vec x}\Phi_0
-\frac{1}{c}\frac{\partial\vec h}{\partial t}
+\frac{1}{c}\vec v\times curl_{\vec x}\vec h-\frac{1}{c}\nabla_{\vec
x}\left(\vec h\cdot\vec v\right)\right|^2+\frac{c^2}{8\pi
G}\left|curl_{\vec x}\vec h\right|^2\\+\frac{c^2\beta}{4\pi
G}\left(\frac{1}{4}\left|d_{\vec x}\vec v+\left\{d_{\vec x}\vec
v\right\}^T\right|^2-\left|\Div_{\vec x}\vec v\right|^2\right) ,
\end{multline}
where
\begin{equation}\label{btfffygtgyggyDDnw22jkhk22mm34}
\vec h=\vec v-\vec
k\quad\text{and}\quad\Phi_0=-\frac{1}{2c}\left|\vec v-\vec
k\right|^2,
\end{equation}
$\mathcal{G}(s):\mathbb{R}\to\mathbb{R}$ is a given function, and
$\beta\in \mathbb{R}$ is some constant. Then, as before, denoting
the function
\begin{equation}\label{ujghjjghjh34}
g(s):=-c^2\mathcal{G}\left(1-\frac{2s}{c^2}\right)\quad\quad\forall
s
\end{equation}
we rewrite \er{vhfffngghkjgghDDnw22mm34mm34} as:
\begin{multline}\label{vhfffngghkjgghDDnw22mm34}
L\left(\vec A,\Psi,\vec v,\Phi_0,\vec h,\vec x,t\right):=\\
\frac{1}{8\pi}\left|-\nabla_{\vec
x}\Psi-\frac{1}{c}\frac{\partial\vec A}{\partial t}+\frac{1}{c}\vec
v\times curl_{\vec x}\vec A\right|^2-\frac{1}{8\pi}\left|curl_{\vec
x}\vec A\right|^2-\left(\rho\Psi-\frac{1}{c}\vec A\cdot\vec j\right)
+\mu g\left(\frac{1}{2}\left|\vec u-\vec v\right|^2\right)
\\
-\frac{c^2}{8\pi G}\left|-\nabla_{\vec x}\Phi_0
-\frac{1}{c}\frac{\partial\vec h}{\partial t}
+\frac{1}{c}\vec v\times curl_{\vec x}\vec h-\frac{1}{c}\nabla_{\vec
x}\left(\vec h\cdot\vec v\right)\right|^2+\frac{c^2}{8\pi
G}\left|curl_{\vec x}\vec h\right|^2\\+\frac{c^2\beta}{4\pi
G}\left(\frac{1}{4}\left|d_{\vec x}\vec v+\left\{d_{\vec x}\vec
v\right\}^T\right|^2-\left|\Div_{\vec x}\vec v\right|^2\right) .
\end{multline}
In other words we have:
\begin{multline}\label{vhfffngghkjgghDDnw22jkjkjkjk22gughhg22mm34}
L\left(\vec A,\Psi,\vec v,\Phi_0,\vec h,\vec
x,t\right)=L_1\left(\vec A,\Psi,\vec v,\vec
x,t\right):=\\
\frac{1}{8\pi}\left|-\nabla_{\vec
x}\Psi-\frac{1}{c}\frac{\partial\vec A}{\partial t}+\frac{1}{c}\vec
v\times curl_{\vec x}\vec A\right|^2-\frac{1}{8\pi}\left|curl_{\vec
x}\vec A\right|^2-\left(\rho\Psi-\frac{1}{c}\vec A\cdot\vec
j\right)+\mu g\left(\frac{1}{2}\left|\vec u-\vec v\right|^2\right)
\\
-\frac{c^2}{8\pi G}\left|
\frac{1}{c}\nabla_{\vec x}\left(\frac{1}{2}\left|\vec v-\vec
k\right|^2\right)-\frac{1}{c}\frac{\partial}{\partial t}\left(\vec
v-\vec k\right)
+\frac{1}{c}\vec v\times curl_{\vec x}\left(\vec v-\vec
k\right)-\frac{1}{c}\nabla_{\vec x}\left(\left(\vec v-\vec
k\right)\cdot\vec v\right)\right|^2\\+\frac{c^2}{8\pi
G}\left|curl_{\vec x}\left(\vec v-\vec
k\right)\right|^2+\frac{c^2\beta}{4\pi
G}\left(\frac{1}{4}\left|d_{\vec x}\vec v+\left\{d_{\vec x}\vec
v\right\}^T\right|^2-\left|\Div_{\vec x}\vec v\right|^2\right).
\end{multline}
In particular, in the inertial coordinate system where $d_{\vec
x}\vec k=0$ and $\partial_t\vec k=0$ we have:
\begin{multline}\label{vhfffngghkjgghDDnw22jkjkjkjk22gughhg22hkhjhjhjh22kljlkjk22mma34}
L\left(\vec A,\Psi,\vec v,\Phi_0,\vec h,\vec
x,t\right)=L_1\left(\vec A,\Psi,\vec v,\vec
x,t\right):=\\
\frac{1}{8\pi}\left|-\nabla_{\vec
x}\Psi-\frac{1}{c}\frac{\partial\vec A}{\partial t}+\frac{1}{c}\vec
v\times curl_{\vec x}\vec A\right|^2-\frac{1}{8\pi}\left|curl_{\vec
x}\vec A\right|^2-\left(\rho\Psi-\frac{1}{c}\vec A\cdot\vec
j\right)+\mu g\left(\frac{1}{2}\left|\vec u-\vec v\right|^2\right)
\\
-\frac{c^2}{8\pi G}\left|
-\frac{1}{c}\frac{\partial\vec v}{\partial t}
+\frac{1}{c}\vec v\times curl_{\vec x}\vec v-\frac{1}{c}\nabla_{\vec
x}\left(\frac{1}{2}\left|\vec
v\right|^2\right)\right|^2+\frac{c^2}{8\pi G}\left|curl_{\vec x}\vec
v\right|^2\\+\frac{c^2\beta}{4\pi G}\left(\frac{1}{4}\left|d_{\vec
x}\vec v+\left\{d_{\vec x}\vec v\right\}^T\right|^2-\left|\Div_{\vec
x}\vec v\right|^2\right),
\end{multline}
Note here the advantage of inertial coordinate systems, where $L$
and $L_1$ are complectly independent of the vectorial potential of
the inertia $\vec k$. Furthermore, by \er{apfrm9} we rewrite
\er{vhfffngghkjgghDDnw22jkjkjkjk22gughhg22hkhjhjhjh22kljlkjk22mma34}
as:
\begin{multline}\label{vhfffngghkjgghDDnw22jkjkjkjk22gughhg22hkhjhjhjh22kljlkjk22mm34}
L\left(\vec A,\Psi,\vec v,\Phi_0,\vec h,\vec
x,t\right)=L_1\left(\vec A,\Psi,\vec v,\vec
x,t\right):=\\
\frac{1}{8\pi}\left|-\nabla_{\vec
x}\Psi-\frac{1}{c}\frac{\partial\vec A}{\partial t}+\frac{1}{c}\vec
v\times curl_{\vec x}\vec A\right|^2-\frac{1}{8\pi}\left|curl_{\vec
x}\vec A\right|^2-\left(\rho\Psi-\frac{1}{c}\vec A\cdot\vec
j\right)+\mu g\left(\frac{1}{2}\left|\vec u-\vec v\right|^2\right)
\\
-\frac{c^2}{8\pi G}\left|
-\frac{1}{c}\left(\frac{\partial\vec v}{\partial t}+d_{\vec x}\vec
v\cdot\vec v\right)\right|^2+\frac{c^2}{8\pi G}\left|curl_{\vec
x}\vec v\right|^2+\frac{c^2\beta}{4\pi
G}\left(\frac{1}{4}\left|d_{\vec x}\vec v+\left\{d_{\vec x}\vec
v\right\}^T\right|^2-\left|\Div_{\vec x}\vec v\right|^2\right).
\end{multline}
Then, using Proposition \ref{yghgjtgyrtrt} by
\er{vhfffngghkjgghDDnw22mm34}, \er{btfffygtgyggyDDnw22jkhk22mm34}
and \er{vhfffngghkjgghDDnw22jkjkjkjk22gughhg22mm34} we deduce that
$L$ and $L_1$ are invariant under the change of inertial or
non-inertial cartesian coordinate system, provided that $\vec h$ and
$\vec A$ are proper vector fields, $\vec v$ is a speed-like vector
field and $\Phi_0$ and $\Psi_0:=\Psi-\frac{1}{c}\vec A\cdot\vec v$
are proper scalar fields.

We point out, again, two the most important choices of function
$\mathcal{G}(s)$ in \er{vhfffngghkjgghDDnw22mm34mm34}: fully
non-relativistic choice $\mathcal{G}(s)=\frac{s}{2}$ and
correspondingly $g(s)=\left(s-\frac{c^2}{2}\right)$; and
relativistic-like choice $\mathcal{G}(s)=\sqrt{s}$ and
correspondingly $g(s):=-c^2\sqrt{1-\frac{2s}{c^2}}$. Note also that
in the first case we have $\frac{dg}{ds}(s)=1$ and in the second
case
$\frac{dg}{ds}(s)=\left(1-\frac{2s}{c^2}\right)^{-\frac{1}{2}}\approx
1$, where the last equation is valid in the case where $2s\ll c^2$.

We investigate critical points of the functional
\begin{equation}\label{btfffygtgyggyDDnw22mm34}
J=\int_0^T\int_{\mathbb{R}^3}L_1\left(\vec A,\Psi,\vec v,\vec
x,t\right)d\vec x dt.
\end{equation}
We denote
\begin{equation}\label{guigjgjffghDDnw22mm34}
\begin{cases}
\Psi_0=\Psi-\frac{1}{c}\vec A\cdot\vec v\\
\vec D=-\nabla_{\vec x}\Psi-\frac{1}{c}\frac{\partial\vec
A}{\partial t}+\frac{1}{c}\vec v\times curl_{\vec x}\vec
A=-\nabla_{\vec x}\Psi_0-\frac{1}{c}\frac{\partial\vec A}{\partial
t}+\frac{1}{c}\vec
v\times curl_{\vec x}\vec A-\frac{1}{c}\nabla_{\vec x}\left(\vec A\cdot\vec v\right)\\
\vec B=curl_{\vec x}\vec A
\\
\vec E=-\nabla_{\vec x}\Psi-\frac{1}{c}\frac{\partial\vec A}{\partial t}=\vec D-\frac{1}{c}\vec v\times\vec B\\
\vec H=curl_{\vec x}\vec A+\frac{1}{c}\vec
v\times\left(-\nabla_{\vec x}\Psi-\frac{1}{c}\frac{\partial\vec
A}{\partial t}+\frac{1}{c}\vec v\times curl_{\vec x}\vec
A\right)=\vec B+\frac{1}{c}\vec v\times\vec D.
\end{cases}
\end{equation}
and
\begin{equation}\label{guigjgjffghDDnwjkhhkhkhjklh22mm34}
\begin{cases}
\vec R=-\nabla_{\vec x}\Phi_0-\frac{1}{c}\frac{\partial\vec
h}{\partial t}
+\frac{1}{c}\vec v\times curl_{\vec x}\vec
h-\frac{1}{c}\nabla_{\vec x}\left(\vec h\cdot\vec v\right)\\
\vec Q=curl_{\vec x}\vec h,
\end{cases}
\end{equation}
where $\vec h$ is a proper vector field and $\Phi_0$ is a proper
scalar field that are given by \er{btfffygtgyggyDDnw22jkhk22mm34}.
In other words,
\begin{equation}\label{guigjgjffghDDnwjkhhkhkhjklh22ljjljkhk22mm34}
\begin{cases}
\vec R=\frac{1}{c}\nabla_{\vec x}\left(\frac{1}{2}\left|\vec v-\vec
k\right|^2\right)-\frac{1}{c}\frac{\partial}{\partial t}\left(\vec
v-\vec k\right)
+\frac{1}{c}\vec v\times curl_{\vec x}\left(\vec v-\vec
k\right)-\frac{1}{c}\nabla_{\vec x}\left(\left(\vec v-\vec
k\right)\cdot\vec v\right)\\
\vec Q=curl_{\vec x}(\vec v-\vec k),
\end{cases}
\end{equation}
and in inertial coordinate system where $d_{\vec x}\vec k=0$ and
$\partial_t\vec k=0$ we also have:
\begin{equation}\label{guigjgjffghDDnwjkhhkhkhjklh22ljjljkhk22hhmm34}
\begin{cases}
\vec R=-\frac{1}{c}\left(\frac{\partial\vec v}{\partial t}+d_{\vec
x}\vec
v\cdot\vec v\right)\\
\vec Q=curl_{\vec x}\vec v.
\end{cases}
\end{equation}
As before, by \er{guigjgjffghDDnw22mm34} and
\er{guigjgjffghDDnwjkhhkhkhjklh22mm34} and Proposition
\ref{yghgjtgyrtrt} we infer that both $\vec D$, $\vec B$ and $\vec
R$, $\vec Q$ are proper vector fields. Next, by
\er{guigjgjffghDDnw22mm34} we have:
\begin{equation}\label{guigjgjffghjhkkgDDnw22mmff34}
\begin{cases}
curl_{\vec x}\vec D+\frac{1}{c}\frac{\partial\vec B}{\partial t}
-\frac{1}{c}\,curl_{\vec x}\left(\vec v\times\vec B\right)=curl_{\vec x}\vec E+\frac{1}{c}\frac{\partial\vec B}{\partial t}=0\\
div_{\vec x}\vec B=0,
\end{cases}
\end{equation}
and by \er{guigjgjffghDDnwjkhhkhkhjklh22mm34} we have:
\begin{equation}\label{guigjgjffghjhkkgDDnw22mm34}
\begin{cases}
curl_{\vec x}\vec R+\frac{1}{c}\frac{\partial\vec Q}{\partial t}-\frac{1}{c}\,curl_{\vec x}\left(\vec v\times\vec Q\right)=0\\
div_{\vec x}\vec Q=0.
\end{cases}
\end{equation}
Furthermore, by \er{guigjgjffghDDnwjkhhkhkhjklh22ljjljkhk22mm34} and
\er{apfrm9} we deduce
\begin{multline}\label{guigjgjffghDDnwjkhhkhkhjklh22ljjljkhk22hkjhjhjh22mm34}
\vec R=\frac{1}{c}\nabla_{\vec x}\left(\frac{1}{2}\left|\vec v-\vec
k\right|^2\right)-\frac{1}{c}\left\{d_{\vec x}\vec
v\right\}^T\cdot\left(\vec v-\vec
k\right)-\frac{1}{c}\frac{\partial}{\partial t}\left(\vec v-\vec
k\right)-\frac{1}{c}\left\{d_{\vec x}\left(\vec v-\vec
k\right)\right\}^T\cdot\vec v
+\frac{1}{c}\vec v\times curl_{\vec x}\left(\vec v-\vec k\right)\\=
-\frac{1}{c}\left\{d_{\vec x}\vec k\right\}^T\cdot\left(\vec v-\vec
k\right)-\frac{1}{c}\frac{\partial}{\partial t}\left(\vec v-\vec
k\right)-\frac{1}{c}d_{\vec x}\left(\vec v-\vec k\right)\cdot\vec v
\\
\quad\text{and}\quad \vec Q=curl_{\vec x}(\vec v-\vec k),
\end{multline}
and thus, since $d_{\vec x}\vec k+\{d_{\vec x}\vec k\}^T=0$,
$curl_{\vec x}(curl_{\vec x}\vec k)=0$ and $\Div_{\vec x}\vec k=0$
we infer
\begin{multline}\label{guigjgjffghDDnwjkhhkhkhjklh22ljjljkhk22hkjhjhjh22jljjkhjgh22hhjhmm34}
\Div_{\vec x}\vec R=
-\frac{1}{c}\left(\vec v-\vec k\right)\cdot\Delta_{\vec x}\vec
k-\frac{1}{c}tr\left(\{d_{\vec x}\vec k\}^T\cdot d_{\vec
x}\left(\vec v-\vec
k\right)\right)-\frac{1}{c}\frac{\partial}{\partial
t}\left(\Div_{\vec x}\vec v\right)-\frac{1}{c}d_{\vec
x}\left(\Div_{\vec x}\vec v\right)\cdot\vec
v\\-\frac{1}{c}tr\left(d_{\vec x}\left(\vec v-\vec k\right)\cdot
d_{\vec x}\vec v
\right)\\
=-\frac{1}{c}\frac{\partial}{\partial t}\left(\Div_{\vec x}\vec
v\right)-\frac{1}{c}d_{\vec x}\left(\Div_{\vec x}\vec
v\right)\cdot\vec v-\frac{1}{c}tr\left(d_{\vec x}\left(\vec v-\vec
k\right)\cdot
d_{\vec x}(\vec v-\vec k)
\right)\\
=-\frac{1}{c}\frac{\partial}{\partial t}\left(\Div_{\vec x}\vec
v\right)-\frac{1}{c}d_{\vec x}\left(\Div_{\vec x}\vec
v\right)\cdot\vec v-\frac{1}{4c}\left|d_{\vec x}(\vec v-\vec
k)+\left\{d_{\vec x}(\vec v-\vec
k)\right\}^T\right|^2+\frac{1}{4c}\left|d_{\vec x}(\vec v-\vec
k)-\left\{d_{\vec x}(\vec v-\vec k)\right\}^T\right|^2
\\
=-\frac{1}{c}\frac{\partial}{\partial t}\left(\Div_{\vec x}\vec
v\right)-\frac{1}{c}d_{\vec x}\left(\Div_{\vec x}\vec
v\right)\cdot\vec v-\frac{1}{4c}\left|d_{\vec x}\vec
v+\left\{d_{\vec x}\vec v\right\}^T\right|^2+\frac{1}{2c}\left|\vec
Q\right|^2
\\
\quad\text{and}\quad curl_{\vec x}\vec Q=curl_{\vec x}(curl_{\vec
x}\vec v).
\end{multline}
So,
\begin{multline}\label{guigjgjffghDDnwjkhhkhkhjklh22ljjljkhk22hkjhjhjh22jljjkhjgh22hhjhjijhjhjtmm34}
\frac{1}{c}\left(\frac{\partial}{\partial t}\left(\Div_{\vec x}\vec
v\right)+d_{\vec x}\left(\Div_{\vec x}\vec v\right)\cdot\vec
v+\frac{1}{4}\left|d_{\vec x}\vec v+\left\{d_{\vec x}\vec
v\right\}^T\right|^2\right)=\frac{1}{2c}\left|\vec
Q\right|^2-\Div_{\vec x}\vec R
\\
\quad\text{and}\quad curl_{\vec x}(curl_{\vec x}\vec v)=curl_{\vec
x}\vec Q,
\end{multline}
In other words,
\begin{multline}\label{guigjgjffghDDnwjkhhkhkhjklh22ljjljkhk22hkjhjhjh22jljjkhjgh22hhjhjijhjhjtjljjmm34}
\frac{1}{c}\left(\frac{\partial}{\partial t}\left(\Div_{\vec x}\vec
v\right)+\Div_{\vec x}\left\{\left(\Div_{\vec x}\vec v\right)\vec
v\right\}+\frac{1}{4}\left|d_{\vec x}\vec v+\left\{d_{\vec x}\vec
v\right\}^T\right|^2-\left|\Div_{\vec x}\vec
v\right|^2\right)=\frac{1}{2c}\left|\vec Q\right|^2-\Div_{\vec
x}\vec R
\\
\quad\text{and}\quad curl_{\vec x}(curl_{\vec x}\vec v)=curl_{\vec
x}\vec Q.
\end{multline}
Moreover, by \er{vhfffngghkjgghDDnw22mm34}, and  \er{apfrm3}
we have
\begin{equation}\label{vhfffngghkjgghggtghjgfhjyuiutDDnw22mm34}
\frac{\delta L}{\delta \vec h}=
\frac{c^2}{4\pi G}curl_{\vec x}\vec Q
-\frac{c}{4\pi G}\left(\frac{\partial\vec R}{\partial t}-curl_{\vec
x}\left\{\vec v\times\vec R\right\}+\left(div_{\vec x}\vec
R\right)\vec v\right),
\end{equation}
\begin{equation}\label{vhfffngghkjgghggtghjgfhjoyuiyuDDnw22mmhkkhjhmm34}
\frac{\delta L}{\delta{\Phi_0}}=-\frac{c^2}{4\pi G}\left(div_{\vec
x}\vec R\right).
\end{equation}
and
\begin{multline}\label{vhfffngghkjgghggtghjgfhjoyuiyuDDnw22mm34}
\frac{\delta L}{\delta \vec v}=-\left(\mu
g'\left(\frac{1}{2}\left|\vec u-\vec v\right|^2\right)(\vec u-\vec
v)+\frac{1}{4\pi c}\vec D\times\vec B-\frac{c}{4\pi G}\vec
R\times\vec Q\right)\\ +\frac{c^2\beta}{4\pi G}\,curl_{\vec
x}(curl_{\vec x}\vec v)-\frac{c}{4\pi G}\left(div_{\vec x}\vec
R\right)\vec h.
\end{multline}
Thus, since by \er{btfffygtgyggyDDnw22jkhk22mm34} we have
\begin{equation}\label{jkhlgytguoiyhioyhmm34}
L_1\left(\vec A,\Psi,\vec v,\vec x,t\right)=L\left(\vec A,\Psi,\vec
v,-\frac{1}{2c}\left|\vec v-\vec k\right|^2,(\vec v-\vec k),\vec
x,t\right)=L\left(\vec A,\Psi,\vec v,\Phi_0,\vec h,\vec x,t\right),
\end{equation}
by Chain rule we have
\begin{equation}\label{jkhlgytguoiyhioyh11mm34}
\frac{\delta L_1}{\delta \vec v}\left(\vec A,\Psi,\vec v,\vec
x,t\right)=
\frac{\delta L}{\delta \vec v}+\frac{\delta L}{\delta \vec
h}-\frac{1}{c}\frac{\delta L}{\delta {\Phi_0}}\left(\vec v-\vec
k\right)
\end{equation}
Therefore, using \er{vhfffngghkjgghggtghjgfhjyuiutDDnw22mm34},
\er{vhfffngghkjgghggtghjgfhjoyuiyuDDnw22mmhkkhjhmm34}
\er{vhfffngghkjgghggtghjgfhjoyuiyuDDnw22mm34} in
\er{jkhlgytguoiyhioyh11mm34} we deduce:
\begin{multline}\label{vhfffngghkjgghggtghjgfhjyuiutDDnw22kllkjkllkjl22klljjkjk22mm34}
\left(\mu g'\left(\frac{1}{2}\left|\vec u-\vec
v\right|^2\right)(\vec u-\vec v)+\frac{1}{4\pi c}\vec D\times\vec
B-\frac{c}{4\pi G}\vec R\times\vec Q\right)=\\ \frac{c^2\beta}{4\pi
G}\,curl_{\vec x}(curl_{\vec x}\vec v)+\frac{c^2}{4\pi G}curl_{\vec
x}\vec Q
-\frac{c}{4\pi G}\left(\frac{\partial\vec R}{\partial t}-curl_{\vec
x}\left\{\vec v\times\vec R\right\}+\left(div_{\vec x}\vec
R\right)\vec v\right).
\end{multline}
Moreover,
\begin{equation}\label{vhfffngghkjgghggtghjgfhjhjkghghDDnw22mm34}
\frac{\delta L}{\delta \Psi}=\frac{1}{4\pi}div_{\vec x}\vec
D-\rho=0,
\end{equation}
and
\begin{equation}\label{vhfffngghkjgghggtghjgfhjhjkghghyuiuuDDnw22mm34}
\frac{\delta L}{\delta \vec A}=\frac{1}{c}\vec j+\frac{1}{4\pi
c}\frac{\partial\vec D}{\partial t}-\frac{1}{4\pi}curl_{\vec x}\vec
B-\frac{1}{4\pi c}curl_{\vec x}\left(\vec v\times \vec
D\right)=\frac{1}{c}\vec j+\frac{1}{4\pi c}\frac{\partial\vec
D}{\partial t}-\frac{1}{4\pi}curl_{\vec x}\vec H=0.
\end{equation}
So
\begin{equation}\label{guigjgjffghguygjyfDDnw22mmd34}
\begin{cases}
curl_{\vec x}\vec B=\frac{4\pi}{c}\vec
j+\frac{1}{c}\frac{\partial\vec D}{\partial t}-\frac{1}{c}curl_{\vec
x}\left(\vec v\times \vec D\right)
\\
div_{\vec x}\vec D=4\pi\rho
\\
curl_{\vec x}\vec D+\frac{1}{c}\frac{\partial\vec B}{\partial t}
-\frac{1}{c}\,curl_{\vec x}\left(\vec v\times\vec B\right)=0\\
div_{\vec x}\vec B=0\\
curl_{\vec x}\vec R+\frac{1}{c}\frac{\partial\vec Q}{\partial t}-\frac{1}{c}curl_{\vec x}\left(\vec v\times\vec Q\right)=0\\
div_{\vec x}\vec Q=0
\\
\frac{1}{c}\left(\frac{\partial}{\partial t}\left(\Div_{\vec x}\vec
v\right)+\vec v\cdot\nabla_{\vec x}\left(\Div_{\vec x}\vec
v\right)+\frac{1}{4}\left|d_{\vec x}\vec v+\left\{d_{\vec x}\vec
v\right\}^T\right|^2\right)=\frac{1}{2c}\left|\vec
Q\right|^2-\Div_{\vec x}\vec R
\\
curl_{\vec x}(curl_{\vec x}\vec v)=curl_{\vec x}\vec Q
\\
\frac{4\pi G}{c^2}\left(\mu g'\left(\frac{1}{2}\left|\vec u-\vec
v\right|^2\right)(\vec u-\vec v)+\frac{1}{4\pi c}\vec D\times\vec
B-\frac{c}{4\pi G}\vec R\times\vec Q\right)=\\
(1+\beta)\,curl_{\vec x}\vec Q
-\frac{1}{c}\left(\frac{\partial\vec R}{\partial t}-curl_{\vec
x}\left\{\vec v\times\vec R\right\}+\left(div_{\vec x}\vec
R\right)\vec v\right),
\end{cases}
\end{equation}
and by \er{guigjgjffghDDnwjkhhkhkhjklh22ljjljkhk22hhmm34} in the
inertial frame we have:
\begin{equation}\label{guigjgjffghguygjyfDDnw22khhjhgg22mmd34}
\begin{cases}
curl_{\vec x}\vec B=\frac{4\pi}{c}\vec
j+\frac{1}{c}\frac{\partial\vec D}{\partial t}-\frac{1}{c}curl_{\vec
x}\left(\vec v\times \vec
D\right)\\
div_{\vec x}\vec D=4\pi\rho\\
curl_{\vec x}\vec D+\frac{1}{c}\frac{\partial\vec B}{\partial t}
-\frac{1}{c}\,curl_{\vec x}\left(\vec v\times\vec B\right)=0\\
div_{\vec x}\vec B=0\\
\vec R=-\frac{1}{c}\left(\frac{\partial\vec v}{\partial t}+d_{\vec
x}\vec
v\cdot\vec v\right)\\
\vec Q=curl_{\vec x}\vec v
\\
\frac{4\pi G}{c^2}\left(\mu g'\left(\frac{1}{2}\left|\vec u-\vec
v\right|^2\right)(\vec u-\vec v)+\frac{1}{4\pi c}\vec D\times\vec
B-\frac{c}{4\pi G}\vec R\times\vec Q\right)=\\
(1+\beta)\,curl_{\vec x}\vec Q
-\frac{1}{c}\left(\frac{\partial\vec R}{\partial t}-curl_{\vec
x}\left\{\vec v\times\vec R\right\}+\left(div_{\vec x}\vec
R\right)\vec v\right).
\end{cases}
\end{equation}
Furthermore,  taking $\Div_{\vec x}$ of the both sides of the last
equality in \er{guigjgjffghguygjyfDDnw22mmd34} and using continuum
equation $\partial_t\mu+\Div_{\vec x}\left(\mu\vec u\right)=0$ we
deduce
\begin{multline}\label{guigjgjffghguygjyfDDnw22mmz34}
-\left(\partial_t\mu+\Div_{\vec x}\left(\mu\vec
v\right)\right)+\Div_{\vec x}\left\{\mu
\left(g'\left(\frac{1}{2}\left|\vec u-\vec
v\right|^2\right)-1\right)(\vec u-\vec v)+\frac{1}{4\pi c}\vec
D\times\vec B-\frac{c}{4\pi  G}\vec R\times\vec Q\right\}=\\
\Div_{\vec x}\left(\mu g'\left(\frac{1}{2}\left|\vec u-\vec
v\right|^2\right)(\vec u-\vec v)+\frac{1}{4\pi c}\vec D\times\vec
B-\frac{c}{4\pi  G}\vec R\times\vec Q\right)=\\
-\frac{c}{4\pi  G}\left(\frac{\partial}{\partial t}\left(\Div_{\vec
x}\vec R\right)+\Div_{\vec x}\left\{\left(div_{\vec x}\vec
R\right)\vec v\right\}\right),
\end{multline}
Therefore, considering the proper scalar quantity $Q_0$, that we
call the field mass, which satisfies
\begin{equation}\label{jtgiktiutiurtfirfmmhfffmm34}
Q_0:=-\mu+\frac{c}{4\pi  G}\,\Div_{\vec x}\vec R,
\end{equation}
by \er{guigjgjffghguygjyfDDnw22mmz34} we deduce
\begin{equation}\label{jtgiktiutiurtfirfmm34}
\frac{\partial Q_0}{\partial t}+div_{\vec x}\left(Q_0\vec
v\right)=-\Div_{\vec x}\left\{\mu
\left(g'\left(\frac{1}{2}\left|\vec u-\vec
v\right|^2\right)-1\right)(\vec u-\vec v)+\frac{1}{4\pi c}\vec
D\times\vec B-\frac{c}{4\pi  G}\vec R\times\vec Q\right\}.
\end{equation}
Thus, we rewrite \er{guigjgjffghguygjyfDDnw22mmd34} as:
\begin{equation}\label{guigjgjffghguygjyfDDnw22mm34}
\begin{cases}
curl_{\vec x}\vec B=\frac{4\pi}{c}\left(\vec j-\rho\vec
v\right)+\frac{1}{c}\frac{\partial\vec D}{\partial
t}-\frac{1}{c}curl_{\vec x}\left(\vec v\times \vec
D\right)+\frac{1}{c}\left(div_{\vec x}\vec D\right)\vec v,
\\
div_{\vec x}\vec D=4\pi\rho,
\\
curl_{\vec x}\vec D+\frac{1}{c}\frac{\partial\vec B}{\partial t}
-\frac{1}{c}\,curl_{\vec x}\left(\vec v\times\vec B\right)=0,\\
div_{\vec x}\vec B=0,\\
curl_{\vec x}\vec R+\frac{1}{c}\frac{\partial\vec Q}{\partial t}-\frac{1}{c}curl_{\vec x}\left(\vec v\times\vec Q\right)=0,\\
div_{\vec x}\vec Q=0,
\\
\frac{4\pi G}{c^2}\left(\mu g'\left(\frac{1}{2}\left|\vec u-\vec
v\right|^2\right)(\vec u-\vec v)+\frac{1}{4\pi c}\vec D\times\vec
B-\frac{c}{4\pi G}\vec R\times\vec Q\right)=\\
(1+\beta)\,curl_{\vec x}\vec Q
-\frac{1}{c}\left(\frac{\partial\vec R}{\partial t}-curl_{\vec
x}\left\{\vec v\times\vec R\right\}+\left(div_{\vec x}\vec
R\right)\vec v\right),\\
\Div_{\vec x}\vec R=\frac{4\pi G}{c}(\mu+Q_0),
\\
\frac{1}{c}\left(\frac{\partial}{\partial t}\left(\Div_{\vec x}\vec
v\right)+\vec v\cdot\nabla_{\vec x}\left(\Div_{\vec x}\vec
v\right)+\frac{1}{4}\left|d_{\vec x}\vec v+\left\{d_{\vec x}\vec
v\right\}^T\right|^2\right)=\frac{1}{2c}\left|\vec
Q\right|^2-\Div_{\vec x}\vec R,
\\
curl_{\vec x}(curl_{\vec x}\vec v)=curl_{\vec x}\vec Q,
\\
\frac{\partial Q_0}{\partial t}+div_{\vec x}\left(Q_0\vec
v\right)=-\Div_{\vec x}\left\{\mu
\left(g'\left(\frac{1}{2}\left|\vec u-\vec
v\right|^2\right)-1\right)(\vec u-\vec v)+\frac{1}{4\pi c}\vec
D\times\vec B-\frac{c}{4\pi  G}\vec R\times\vec Q\right\},
\end{cases}
\end{equation}
and we rewrite \er{guigjgjffghguygjyfDDnw22khhjhgg22mmd34} in the
inertial frame as:
\begin{equation}\label{guigjgjffghguygjyfDDnw22khhjhgg22mm34}
\begin{cases}
curl_{\vec x}\vec B=\frac{4\pi}{c}\vec
j+\frac{1}{c}\frac{\partial\vec D}{\partial t}-\frac{1}{c}curl_{\vec
x}\left(\vec v\times \vec
D\right),\\
div_{\vec x}\vec D=4\pi\rho,\\
curl_{\vec x}\vec D+\frac{1}{c}\frac{\partial\vec B}{\partial t}
-\frac{1}{c}\,curl_{\vec x}\left(\vec v\times\vec B\right)=0,\\
div_{\vec x}\vec B=0,\\
\vec R=-\frac{1}{c}\left(\frac{\partial\vec v}{\partial t}+d_{\vec
x}\vec
v\cdot\vec v\right),\\
\vec Q=curl_{\vec x}\vec v,
\\
\frac{4\pi G}{c^2}\left(\mu g'\left(\frac{1}{2}\left|\vec u-\vec
v\right|^2\right)(\vec u-\vec v)+\frac{1}{4\pi c}\vec D\times\vec
B-\frac{c}{4\pi G}\vec R\times\vec Q\right)=\\
(1+\beta)\,curl_{\vec x}\vec Q
-\frac{1}{c}\left(\frac{\partial\vec R}{\partial t}-curl_{\vec
x}\left\{\vec v\times\vec R\right\}+\left(div_{\vec x}\vec
R\right)\vec v\right),\\
\Div_{\vec x}\vec R=\frac{4\pi G}{c} (\mu+Q_0),\\
\frac{\partial Q_0}{\partial t}+div_{\vec x}\left(Q_0\vec
v\right)=-\Div_{\vec x}\left\{\mu
\left(g'\left(\frac{1}{2}\left|\vec u-\vec
v\right|^2\right)-1\right)(\vec u-\vec v)+\frac{1}{4\pi c}\vec
D\times\vec B-\frac{c}{4\pi  G}\vec R\times\vec Q\right\}.
\end{cases}
\end{equation}
As before, by Proposition \ref{yghgjtgyrtrt} we deduce that
\er{guigjgjffghguygjyfDDnw22mm34} is invariant under the change of
inertial or non-inertial cartesian coordinate systems. Moreover,
\er{guigjgjffghguygjyfDDnw22khhjhgg22mm34} is invariant under the
change of inertial cartesian coordinate systems. Note also that,
both, \er{guigjgjffghguygjyfDDnw22mm34} in an arbitrary inertial or
non-inertial cartesian coordinate system and
\er{guigjgjffghguygjyfDDnw22khhjhgg22mm34} in an arbitrary inertial
cartesian coordinate system, are complectly independent of the
vectorial potential of the inertia $\vec k$. Note again for the last
equation in \er{guigjgjffghguygjyfDDnw22mm34} or
\er{guigjgjffghguygjyfDDnw22khhjhgg22mm34} that: in the fully
non-relativistic case we have $g'(s)=1$ and in the relativistic-like
case we have
$g'(s)=\left(1-\frac{2s}{c^2}\right)^{-\frac{1}{2}}\approx 1$, where
the last equation is valid in the case where $2s\ll c^2$.

Finally, as before, note that in the case of large constant
$|\beta|\gg 1$ we have $\vec Q\to 0$ in
\er{guigjgjffghguygjyfDDnw22mm34} and thus, the gravity equations
\er{guigjgjffghguygjyfDDnw22mm34} reduce to the equations of the
Newtonian-type Gravity in the form of
\er{MaxVacFull1bjkgjhjhgjaaahkjhhjzzzyyykkknnnNWBWHWPPNNewCCmmhhjjhkkk}.
In that case the gravity field propagates with the infinite speed.
On the other hand, in the case of vanishing constant $\beta=0$ the
form of equations for $\vec R$ and $\vec Q$ in
\er{guigjgjffghguygjyfDDnw22mm34} is completely the same as the form
of the Maxwell equations for $\vec D$ and $\vec B$ in
\er{guigjgjffghguygjyfDDnw22mm34}, except of the different meaning
of "charges" and "currents" in these two sets of equations. In that
case the electromagnetic and the gravity fields propagate with the
same speed. However, in the mixed case of constant $\beta\sim 1$ the
electromagnetic and the gravity fields propagate with different
finite speeds.

\begin{remark}\label{gygygygyggjh3499}
One can wonder: what should be possible values of the vectorial
gravitational potential $\vec v$ in the proximity of the Earth or
another massive body? In remark \ref{gygygygyggjh} we answered this
question in the case of the Newtonian-type gravity, given by
\er{MaxVacFull1ninshtrgravortghhghgjkgghklhjgkghghjjkjhjkkggjkhjkhjjhhfhjhklkhkhjjklzzzyyyNWBWHWPPN222ll}.
Moreover, in  remark \ref{gygygygyggjh34} we answered this question
in the case of more general laws of the gravity, given by
\er{guigjgjffghguygjyfDDnw22mm} with arbitrary constant $\beta$. In
order to answer this question in the case of more general laws of
the gravity, given by ether
\er{MaxVacFull1bjkgjhjhgjaaahkjhhjzzzyyykkknnnNWBWHWPPNNewCCmmhhjjhkkk}
or \er{guigjgjffghguygjyfDDnw22mm34} with arbitrary constant
$\beta$, consider two cartesian coordinate systems:
\underline{non-rotating} system $\bf(*)$ with the center that
coincides with the center of masses of the Earth and
\underline{inertial} system $\bf(**)$ related to some external
cosmic bodies. Assume that the center of masses of the Earth has
place $\vec R(t')$ and velocity $\vec W(t'):=\frac{d\vec
R}{dt'}(t')$ in the coordinate system $\bf(**)$. Thus the change of
coordinate system $\bf(*)$ to coordinate system $\bf(**)$ is given
by
\begin{equation}\label{hjjgghghghyu3499}
\begin{cases}
\vec x'=\vec x+\vec R(t),\\
t'=t,
\end{cases}
\end{equation}
and the vectorial gravitational potential $\vec v$, being a speed
like vector field, transforms as
\begin{equation}\label{hjjgghghghyulll3499}
\vec v'=\vec v+\vec W(t).
\end{equation}
Next, since the system $\bf(**)$ is inertial, consistently with
\er{guigjgjffghguygjyfDDnw22khhjhgg22mm34} in the system $\bf(**)$
we have
\begin{equation}\label{guigjgjffghguygjyfDDnw22khhjhgg22mm34hhhhhh3499}
\begin{cases}
curl_{\vec x'}\vec B'=\frac{4\pi}{c}\left(\vec j'-\rho'\vec
v'\right)+\frac{1}{c}\frac{\partial\vec D'}{\partial
t'}-\frac{1}{c}curl_{\vec x'}\left(\vec v'\times \vec
D'\right)+\frac{1}{c}\left(div_{\vec x'}\vec D'\right)\vec v',
\\
div_{\vec x}\vec D'=4\pi\rho',
\\
curl_{\vec x'}\vec D'+\frac{1}{c}\frac{\partial\vec B'}{\partial t'}
-\frac{1}{c}\,curl_{\vec x'}\left(\vec v'\times\vec B'\right)=0,\\
div_{\vec x'}\vec B'=0,\\
\vec R'=-\frac{1}{c}\left(\frac{\partial\vec v'}{\partial
t'}+d_{\vec x'}\vec
v'\cdot\vec v'\right),\\
\vec Q'=curl_{\vec x'}\vec v',
\\
curl_{\vec x'}\vec R'+\frac{1}{c}\frac{\partial\vec Q'}{\partial t'}-\frac{1}{c}curl_{\vec x'}\left(\vec v'\times\vec Q'\right)=0,\\
div_{\vec x'}\vec Q'=0,
\\
\frac{4\pi G}{c^2}\left(\mu'g'\left(\frac{1}{2}\left|\vec u'-\vec
v'\right|^2\right)\left(\vec u'-\vec v'\right)+\frac{1}{4\pi c}\vec
D'\times\vec
B'-\frac{c}{4\pi G}\vec R'\times\vec Q'\right)=\\
(1+\beta)\,curl_{\vec x'}\vec Q'
-\frac{1}{c}\left(\frac{\partial\vec R'}{\partial t'}-curl_{\vec
x'}\left\{\vec v'\times\vec R'\right\}+\left(div_{\vec x}\vec
R'\right)\vec v'\right),\\
\Div_{\vec x'}\vec R'=\frac{4\pi G}{c}(\mu'+Q'_0),
\\
\frac{\partial Q'_0}{\partial t'}+div_{\vec x'}\left(Q'_0\vec
v'\right)=\\-\Div_{\vec x'}\left\{
\mu'\left(g'\left(\frac{1}{2}\left|\vec u'-\vec
v'\right|^2\right)-1\right)\left(\vec u'-\vec v'\right)+
\frac{1}{4\pi c}\vec D'\times\vec B'-\frac{c}{4\pi  G}\vec
R'\times\vec Q'\right\},
\end{cases}
\end{equation}
and by \er{guigjgjffghguygjyfDDnw22mm34} in the system $\bf(*)$ we
have
\begin{equation}\label{guigjgjffghguygjyfDDnw22mm343499}
\begin{cases}
curl_{\vec x}\vec B=\frac{4\pi}{c}\left(\vec j-\rho\vec
v\right)+\frac{1}{c}\frac{\partial\vec D}{\partial
t}-\frac{1}{c}curl_{\vec x}\left(\vec v\times \vec
D\right)+\frac{1}{c}\left(div_{\vec x}\vec D\right)\vec v,
\\
div_{\vec x}\vec D=4\pi\rho,
\\
curl_{\vec x}\vec D+\frac{1}{c}\frac{\partial\vec B}{\partial t}
-\frac{1}{c}\,curl_{\vec x}\left(\vec v\times\vec B\right)=0,\\
div_{\vec x}\vec B=0,\\
curl_{\vec x}\vec R+\frac{1}{c}\frac{\partial\vec Q}{\partial t}-\frac{1}{c}curl_{\vec x}\left(\vec v\times\vec Q\right)=0,\\
div_{\vec x}\vec Q=0,
\\
\frac{4\pi G}{c^2}\left(\mu g'\left(\frac{1}{2}\left|\vec u-\vec
v\right|^2\right)\left(\vec u-\vec v\right)+\frac{1}{4\pi c}\vec
D\times\vec
B-\frac{c}{4\pi G}\vec R\times\vec Q\right)=\\
(1+\beta)\,curl_{\vec x}\vec Q
-\frac{1}{c}\left(\frac{\partial\vec R}{\partial t}-curl_{\vec
x}\left\{\vec v\times\vec R\right\}+\left(div_{\vec x}\vec
R\right)\vec v\right),\\
\Div_{\vec x}\vec R=\frac{4\pi G}{c}(\mu+Q_0),
\\
\frac{1}{c}\left(\frac{\partial}{\partial t}\left(\Div_{\vec x}\vec
v\right)+\vec v\cdot\nabla_{\vec x}\left(\Div_{\vec x}\vec
v\right)+\frac{1}{4}\left|d_{\vec x}\vec v+\left\{d_{\vec x}\vec
v\right\}^T\right|^2\right)=\frac{1}{2c}\left|\vec
Q\right|^2-\Div_{\vec x}\vec R,
\\
curl_{\vec x}(curl_{\vec x}\vec v)=curl_{\vec x}\vec Q,
\\
\frac{\partial Q_0}{\partial t}+div_{\vec x}\left(Q_0\vec
v\right)=-\Div_{\vec x}\left\{\mu
\left(g'\left(\frac{1}{2}\left|\vec u-\vec
v\right|^2\right)-1\right)(\vec u-\vec v)+\frac{1}{4\pi c}\vec
D\times\vec B-\frac{c}{4\pi  G}\vec R\times\vec Q\right\}.
\end{cases}
\end{equation}
On the other hand, by \er{hjjgghghghyu3499} and
\er{hjjgghghghyulll3499}
and using Proposition \ref{yghgjtgyrtrt}
we deduce
\begin{equation}
\label{MaxVacFull1ninshtrgravortghhghgjkgghklhjgkghghjjkjhjkkggjkhjkhjjhhfhjhklkhkhjjklzzzyyyhjggjhgghhjhNWNWBWHWPPN222kkkgtytghjjh3434499}
curl_{\vec x}\vec v=curl_{\vec x'}\vec v',
\end{equation}
and
\begin{equation}
\label{MaxVacFull1ninshtrgravortghhghgjkgghklhjgkghghjjkjhjkkggjkhjkhjjhhfhjhklkhkhjjklzzzyyyhjggjhgghhjhNWNWBWHWPPN222kkkgtytghjjhghhg3434499}
\frac{d\vec W}{\partial t}(t)+\frac{\partial\vec v}{\partial
t}+d_{\vec x}\vec v\cdot\vec v=\frac{d\vec W}{\partial
t'}(t')+\frac{\partial\vec v}{\partial t'}+d_{\vec x'}\vec
v\cdot\vec W(t')+d_{\vec x'}\vec v\cdot\vec v= \frac{\partial\vec
v'}{\partial t'}+d_{\vec x'}\vec v'\cdot\vec v'.
\end{equation}
Thus since $\vec R$ is a proper field, by
\er{guigjgjffghguygjyfDDnw22khhjhgg22mm34hhhhhh3499},
\er{MaxVacFull1ninshtrgravortghhghgjkgghklhjgkghghjjkjhjkkggjkhjkhjjhhfhjhklkhkhjjklzzzyyyhjggjhgghhjhNWNWBWHWPPN222kkkgtytghjjh3434499}
and
\er{MaxVacFull1ninshtrgravortghhghgjkgghklhjgkghghjjkjhjkkggjkhjkhjjhhfhjhklkhkhjjklzzzyyyhjggjhgghhjhNWNWBWHWPPN222kkkgtytghjjhghhg3434499}
we deduce
\begin{equation}
\label{MaxVacFull1ninshtrgravortghhghgjkgghklhjgkghghjjkjhjkkggjkhjkhjjhhfhjhklkhkhjjklzzzyyyhjggjhgghhjhNWNWBWHWPPN222kkkgtytghjjh343445499}
curl_{\vec x}\vec v=\vec Q'=\vec Q,
\end{equation}
and
\begin{equation}
\label{MaxVacFull1ninshtrgravortghhghgjkgghklhjgkghghjjkjhjkkggjkhjkhjjhhfhjhklkhkhjjklzzzyyyhjggjhgghhjhNWNWBWHWPPN222kkkgtytghjjhghhg343445499}
\frac{d\vec W}{\partial t}(t)+\frac{\partial\vec v}{\partial
t}+d_{\vec x}\vec v\cdot\vec v=-c\,\vec R'=-c\,\vec R.
\end{equation}
Therefore, by \er{guigjgjffghguygjyfDDnw22mm343499},
\er{MaxVacFull1ninshtrgravortghhghgjkgghklhjgkghghjjkjhjkkggjkhjkhjjhhfhjhklkhkhjjklzzzyyyhjggjhgghhjhNWNWBWHWPPN222kkkgtytghjjh343445499}
and
\er{MaxVacFull1ninshtrgravortghhghgjkgghklhjgkghghjjkjhjkkggjkhjkhjjhhfhjhklkhkhjjklzzzyyyhjggjhgghhjhNWNWBWHWPPN222kkkgtytghjjhghhg343445499}
in the system $\bf(*)$ we have
\begin{equation}\label{guigjgjffghguygjyfDDnw22khhjhgg22mmhkkhhjjjg99}
\begin{cases}
curl_{\vec x}\vec B=\frac{4\pi}{c}\vec
j+\frac{1}{c}\frac{\partial\vec D}{\partial t}-\frac{1}{c}curl_{\vec
x}\left(\vec v\times \vec
D\right),\\
div_{\vec x}\vec D=4\pi\rho,\\
curl_{\vec x}\vec D+\frac{1}{c}\frac{\partial\vec B}{\partial t}
-\frac{1}{c}\,curl_{\vec x}\left(\vec v\times\vec B\right)=0,\\
div_{\vec x}\vec B=0,\\
\frac{1}{c}\frac{d\vec W}{\partial t}(t)+\vec
R=-\frac{1}{c}\left(\frac{\partial\vec v}{\partial t}+d_{\vec x}\vec
v\cdot\vec v\right),\\
\vec Q=curl_{\vec x}\vec v,
\\
\frac{4\pi G}{c^2}\left(\mu g'\left(\frac{1}{2}\left|\vec u-\vec
v\right|^2\right)\left(\vec u-\vec v\right)+\frac{1}{4\pi c}\vec
D\times\vec
B-\frac{c}{4\pi G}\vec R\times\vec Q\right)=\\
(1+\beta)\,curl_{\vec x}\vec Q
-\frac{1}{c}\left(\frac{\partial\vec R}{\partial t}-curl_{\vec
x}\left\{\vec v\times\vec R\right\}+\left(div_{\vec x}\vec
R\right)\vec v\right),\\
\Div_{\vec x}\vec R=\frac{4\pi G}{c} (\mu+Q_0),\\
\frac{\partial Q_0}{\partial t}+div_{\vec x}\left(Q_0\vec
v\right)=-\Div_{\vec x}\left\{\mu
\left(g'\left(\frac{1}{2}\left|\vec u-\vec
v\right|^2\right)-1\right)(\vec u-\vec v)+\frac{1}{4\pi c}\vec
D\times\vec B-\frac{c}{4\pi  G}\vec R\times\vec Q\right\}.
\end{cases}
\end{equation}
On the other hand, since the system $\bf(**)$ is inertial, the
quantity $\frac{d\vec W}{\partial t}(t)$, being generated by the
gravitational field from the far bodies, is insignificant
with respect to the quantity $c\vec R$ in the scale compatible to
the Earth size. Thus, we rewrite
\er{guigjgjffghguygjyfDDnw22khhjhgg22mmhkkhhjjjg99} as:
\begin{equation}\label{guigjgjffghguygjyfDDnw22khhjhgg22mmhkkhhjjjgioo99}
\begin{cases}
curl_{\vec x}\vec B=\frac{4\pi}{c}\vec
j+\frac{1}{c}\frac{\partial\vec D}{\partial t}-\frac{1}{c}curl_{\vec
x}\left(\vec v\times \vec
D\right),\\
div_{\vec x}\vec D=4\pi\rho,\\
curl_{\vec x}\vec D+\frac{1}{c}\frac{\partial\vec B}{\partial t}
-\frac{1}{c}\,curl_{\vec x}\left(\vec v\times\vec B\right)=0,\\
div_{\vec x}\vec B=0,\\
\vec R=-\frac{1}{c}\left(\frac{\partial\vec v}{\partial t}+d_{\vec
x}\vec
v\cdot\vec v\right),\\
\vec Q=curl_{\vec x}\vec v,
\\
\frac{4\pi G}{c^2}\left(\mu g'\left(\frac{1}{2}\left|\vec u-\vec
v\right|^2\right)\left(\vec u-\vec v\right)+\frac{1}{4\pi c}\vec
D\times\vec
B-\frac{c}{4\pi G}\vec R\times\vec Q\right)=\\
(1+\beta)\,curl_{\vec x}\vec Q
-\frac{1}{c}\left(\frac{\partial\vec R}{\partial t}-curl_{\vec
x}\left\{\vec v\times\vec R\right\}+\left(div_{\vec x}\vec
R\right)\vec v\right),\\
\Div_{\vec x}\vec R=\frac{4\pi G}{c} (\mu+Q_0),\\
\frac{\partial Q_0}{\partial t}+div_{\vec x}\left(Q_0\vec
v\right)=-\Div_{\vec x}\left\{\mu
\left(g'\left(\frac{1}{2}\left|\vec u-\vec
v\right|^2\right)-1\right)(\vec u-\vec v)+\frac{1}{4\pi c}\vec
D\times\vec B-\frac{c}{4\pi  G}\vec R\times\vec Q\right\}.
\end{cases}
\end{equation}
Next, we can neglect all the far cosmic body masses expect of the
Earth itself and thus we can consider
\begin{equation}
\label{MaxVacFull1ninshtrgravortghhghgjkgghklhjgkghghjjkjhjkkggjkhjkhjjhhfhjhklkhkhjjklzzzyyyhjggjhgghhjhNWNWBWHWPPN222kkkjhjhjjhjhjhjhhjhj3444jkkyipi99}
\mu(\vec x,t)=\mu_1\left(|\vec x|\right)\quad\text{and}\quad\vec
u(\vec x,t)=0 \quad\quad\forall\,(\vec x,t),
\end{equation}
where $\mu_1:=\mu_1\left(|\vec x|\right)$ is the inertial mass
density of the Earth which is assumed to be a radial function such
that
\begin{equation}
\label{MaxVacFull1ninshtrgravortghhghgjkgghklhjgkghghjjkjhjkkggjkhjkhjjhhfhjhklkhkhjjklzzzyyyhjggjhgghhjhNWNWBWHWPPN222kkkjhjhjjhjhjhjhhjhj3444jkkyipioyiyi99}
\mu_1\left(|\vec x|\right)=0\quad\quad\text{if}\quad |\vec x|>r_0,
\end{equation}
where $r_0$ is the Earth radius. Moreover, we can neglect all the
electromagnetical masses and thus we simplify the equations for the
Gravity in \er{guigjgjffghguygjyfDDnw22khhjhgg22mmhkkhhjjjgioo} as:
\begin{equation}\label{guigjgjffghguygjyfDDnw22khhjhgg22mmhkkhhjjjgioojkjkggjg99}
\begin{cases}
\vec R=-\frac{1}{c}\left(\frac{\partial\vec v}{\partial t}+d_{\vec
x}\vec
v\cdot\vec v\right),\\
\vec Q=curl_{\vec x}\vec v,
\\
\frac{4\pi G}{c^2}\left(-\mu_1\left(|\vec x|\right)
g'\left(\frac{1}{2}\left|\vec
v\right|^2\right)\vec v-\frac{c}{4\pi G}\vec R\times\vec Q\right)=\\
(1+\beta)\,curl_{\vec x}\vec Q
-\frac{1}{c}\left(\frac{\partial\vec R}{\partial t}-curl_{\vec
x}\left\{\vec v\times\vec R\right\}+\left(div_{\vec x}\vec
R\right)\vec v\right),\\
\Div_{\vec x}\vec R=\frac{4\pi G}{c} \left(\mu_1\left(|\vec x|\right)+Q_0\right),\\
\frac{\partial Q_0}{\partial t}+div_{\vec x}\left(Q_0\vec
v\right)=\Div_{\vec x}\left\{\mu_1\left(|\vec x|\right)
\left(g'\left(\frac{1}{2}\left|\vec v\right|^2\right)-1\right)\vec
v+\frac{c}{4\pi  G}\vec R\times\vec Q\right\}.
\end{cases}
\end{equation}
Being in the system  $\bf(*)$ which is stationary with respect to
the center of the Earth we look for stationary (i.e. time
independent) solutions of
\er{guigjgjffghguygjyfDDnw22khhjhgg22mmhkkhhjjjgioojkjkggjg99}. Thus
\er{guigjgjffghguygjyfDDnw22khhjhgg22mmhkkhhjjjgioojkjkggjg99}
implies:
\begin{equation}\label{guigjgjffghguygjyfDDnw22khhjhgg22mmhkkhhjjjgioojkjkggjghjggjhh99}
\begin{cases}
\vec R=-\frac{1}{c}\,d_{\vec x}\vec
v\cdot\vec v,\\
\vec Q=curl_{\vec x}\vec v,
\\
\frac{4\pi G}{c^2}\left(-\mu_1\left(|\vec
x|\right)g'\left(\frac{1}{2}\left|\vec
v\right|^2\right)\vec v-\frac{c}{4\pi G}\vec R\times\vec Q\right)=\\
(1+\beta)\,curl_{\vec x}\vec Q
-\frac{1}{c}\left(-curl_{\vec x}\left\{\vec v\times\vec
R\right\}+\left(div_{\vec x}\vec
R\right)\vec v\right),\\
\Div_{\vec x}\vec R=\frac{4\pi G}{c} \left(\mu_1\left(|\vec x|\right)+Q_0\right),\\
div_{\vec x}\left(Q_0\vec v\right)=\Div_{\vec
x}\left\{\mu_1\left(|\vec x|\right)
\left(g'\left(\frac{1}{2}\left|\vec v\right|^2\right)-1\right)\vec
v+\frac{c}{4\pi G}\vec R\times\vec Q\right\}.
\end{cases}
\end{equation}
On the other hand, by the symmetry considerations of the problem we
look for the solution of
\er{guigjgjffghguygjyfDDnw22khhjhgg22mmhkkhhjjjgioojkjkggjghjggjhh99}
that satisfies $\vec v(\vec x)=\nabla_{\vec x}Z_0\left(|\vec
x|\right)$ where, again by the symmetry of the problem, the scalar
function $Z_0\left(|\vec x|\right)$ should be radial. In particular,
by
\er{guigjgjffghguygjyfDDnw22khhjhgg22mmhkkhhjjjgioojkjkggjghjggjhh99}
we obtain
\begin{equation}\label{guigjgjffghguygjyfDDnjggjgjgj99}
\vec Q=curl_{\vec x}\vec v=0
\end{equation}
and thus we simplify
\er{guigjgjffghguygjyfDDnw22khhjhgg22mmhkkhhjjjgioojkjkggjghjggjhh99}
as:
\begin{equation}\label{guigjgjffghguygjyfDDnw22khhjhgg22mmhkkhhjjjgioojkjkggjghjggjhhuiiyyiyijkjkhkkh99}
\begin{cases}
\vec R=-\frac{1}{c}\,d_{\vec x}\vec v\cdot\vec
v=-\frac{1}{c}\,\nabla_{\vec x}\left(\frac{1}{2}\left|\vec
v\right|^2\right),\\
\vec Q=curl_{\vec x}\vec v=0,
\\
\frac{4\pi G}{c^2}\,\mu_1\left(|\vec
x|\right)g'\left(\frac{1}{2}\left|\vec v\right|^2\right)\vec v=
\frac{1}{c}\left(-curl_{\vec x}\left\{\vec v\times\vec
R\right\}+\left(div_{\vec x}\vec
R\right)\vec v\right),\\
\Div_{\vec x}\vec R=\frac{4\pi G}{c}\,\mu_1\left(|\vec
x|\right)\,g'\left(\frac{1}{2}\left|\vec
v\right|^2\right),\\
curl_{\vec x}\vec R=0,\\
Q_0=\mu_1\left(|\vec x|\right) \left(g'\left(\frac{1}{2}\left|\vec
v\right|^2\right)-1\right).
\end{cases}
\end{equation}
In particular, since $\vec v=\nabla_{\vec x}Z_0\left(|\vec
x|\right)$ and $\vec R=-\frac{1}{c}\,\nabla_{\vec
x}\left(\frac{1}{2}\left|\vec v\right|^2\right)$ are both gradients
of radial functions, we have $\vec v\times\vec R=0$ and thus, we
further simplify
\er{guigjgjffghguygjyfDDnw22khhjhgg22mmhkkhhjjjgioojkjkggjghjggjhhuiiyyiyijkjkhkkh99}
as:
\begin{equation}\label{guigjgjffghguygjyfDDnw22khhjhgg22mmhkkhhjjjgioojkjkggjghjggjhhuiiyyiyijkjkhkkhuu99}
\begin{cases}
\vec R=-\frac{1}{c}\,\nabla_{\vec x}\left(\frac{1}{2}\left|\vec
v\right|^2\right),\\
\vec Q=curl_{\vec x}\vec v=0,
\\
\Div_{\vec x}\vec R=\frac{4\pi G}{c}\,\mu_1\left(|\vec
x|\right)\,g'\left(\frac{1}{2}\left|\vec
v\right|^2\right),\\
curl_{\vec x}\vec R=0,\\
Q_0=\mu_1\left(|\vec x|\right) \left(g'\left(\frac{1}{2}\left|\vec
v\right|^2\right)-1\right).
\end{cases}
\end{equation}
However,
\er{guigjgjffghguygjyfDDnw22khhjhgg22mmhkkhhjjjgioojkjkggjghjggjhhuiiyyiyijkjkhkkhuu99}
is equivalent to the following:
\begin{equation}\label{guigjgjffghguygjyfDDnw22khhjhgg22mmhkkhhjjjgioojkjkggjghjggjhhuiiyyiyijkjkhkkhuuhhhhhjgjikuo99}
\begin{cases}
\Delta_{\vec x}\left(-\frac{1}{2}\left|\vec v\right|^2\right)=4\pi
G\,\mu_1\left(|\vec x|\right)\,g'\left(\frac{1}{2}\left|\vec
v\right|^2\right),\\
curl_{\vec x}\vec v=0,\\
\vec R=-\frac{1}{c}\,\nabla_{\vec x}\left(\frac{1}{2}\left|\vec
v\right|^2\right),\\
\vec Q=0,
\\
Q_0=\mu_1\left(|\vec x|\right) \left(g'\left(\frac{1}{2}\left|\vec
v\right|^2\right)-1\right).
\end{cases}
\end{equation}
Therefore, denoting
\begin{equation}\label{gjfghdhgdfjfjfgukiufirfiufuy99}
\Phi_1:=-\frac{1}{2}\left|\vec v\right|^2
\end{equation}
we rewrite
\er{guigjgjffghguygjyfDDnw22khhjhgg22mmhkkhhjjjgioojkjkggjghjggjhhuiiyyiyijkjkhkkhuuhhhhhjgjikuo99}
as:
\begin{equation}\label{guigjgjffghguygjyfDDnw22khhjhgg22mmhkkhhjjjgioojkjkggjghjggjhhuiiyyiyijkjkhkkhuuhhhhhjgjikuojhjhjhjguyttty99}
\begin{cases}
\Delta_{\vec x}\Phi_1=4\pi G\,\mu_1\left(|\vec
x|\right)\,g'\left(-\Phi_1\right),\\
\Phi_1:=-\frac{1}{2}\left|\vec v\right|^2,
\\
curl_{\vec x}\vec v=0,\\
\vec R=-\frac{1}{c}\,\nabla_{\vec x}\left(\frac{1}{2}\left|\vec
v\right|^2\right),\\
\vec Q=0,
\\
Q_0=\mu_1\left(|\vec x|\right)
\left(g'\left(-\Phi_1\right)-1\right),
\end{cases}
\end{equation}
where the scalar field $\Phi_1=\Phi\left(|\vec x|\right)$ is radial,
and outside of the Earth surface it coincides with the following
Newtonian potential of the Earth:
 $\Phi_1(\vec
x)=-\frac{GM_0}{|\vec x|}$, where $M_0$ is the total effective
\underline{gravitational} mass of the Earth, defined as
\begin{equation}
\label{MaxVacFull1ninshtrgravortghhghgjkgghklhjgkghghjjkjhjkkggjkhjkhjjhhfhjhklkhkhjjklzzzyyyhjggjhgghhjhNWNWBWHWPPN222kkkgtytghjjhpoppkkkhhhk3499jhikjhjhj99jkkh}
M_0=\iiint_{|\vec x|\leq r_0}\mu_1\left(|\vec
x|\right)g'\left(-\Phi_1\left(|\vec x|\right)\right)\,d\vec
x=\iiint_{|\vec x|\leq r_0}\mu_1\left(|\vec
x|\right)g'\left(\frac{1}{2}\left|\nabla_{\vec x}Z_0\left(|\vec
x|\right)\right|^2\right)\,d\vec x,
\end{equation}
(for the inertial mass of the Earth $m_0$ we have $m_0=\iiint_{|\vec
x|\leq r_0}\mu_1\left(|\vec x|\right)d\vec x$ and thus, in the
relativistic-like case, where
$g'(s)=\left(1-\frac{2s}{c^2}\right)^{-\frac{1}{2}}>1$ we have
$M_0>m_0$). Thus, since there exists a scalar radial field
$Z_0\left(|\vec x|\right)$ such that $\vec v(\vec x)=\nabla_{\vec
x}Z_0\left(|\vec x|\right)$
by
\er{guigjgjffghguygjyfDDnw22khhjhgg22mmhkkhhjjjgioojkjkggjghjggjhhuiiyyiyijkjkhkkhuuhhhhhjgjikuojhjhjhjguyttty99}
we obtain
\begin{equation}
\label{MaxVacFull1ninshtrgravortghhghgjkgghklhjgkghghjjkjhjkkggjkhjkhjjhhfhjhklkhkhjjklzzzyyyhjggjhgghhjhNWNWBWHWPPN222kkkgtytghjjhpoppkkkhhhk3499}
\left|\frac{dZ_0}{d(|\vec x|)}(|\vec x|)\right|= \sqrt{-2\Phi_1(\vec
x)},
\end{equation}
that implies either
\begin{equation}
\label{MaxVacFull1ninshtrgravortghhghgjkgghklhjgkghghjjkjhjkkggjkhjkhjjhhfhjhklkhkhjjklzzzyyyhjggjhgghhjhNWNWBWHWPPN222kkkgtytghjjhpoppkkkhhhkhjhj3499}
\vec v(\vec x)= \frac{\sqrt{-2\Phi_1(|\vec x|)}}{|\vec x|}\vec x,
\end{equation}
or
\begin{equation}
\label{MaxVacFull1ninshtrgravortghhghgjkgghklhjgkghghjjkjhjkkggjkhjkhjjhhfhjhklkhkhjjklzzzyyyhjggjhgghhjhNWNWBWHWPPN222kkkgtytghjjhpoppkkkhhhkhjhjiuu3499}
\vec v(\vec x)= -\frac{\sqrt{-2\Phi_1(|\vec x|)}}{|\vec x|}\vec x,
\end{equation}
exactly as in the case of the usual Newtonian gravity. In
particular, on the Earth surface we have:
\begin{equation}
\label{MaxVacFull1ninshtrgravortghhghgjkgghklhjgkghghjjkjhjkkggjkhjkhjjhhfhjhklkhkhjjklzzzyyyhjggjhgghhjhNWNWBWHWPPN222kkkgtytghjjhpoppkkkhhhkhjhjiuuokokok3499}
|\vec v|=\sqrt{\frac{2GM_0}{r_0}},
\end{equation}
where $r_0$ is the Earth radius and $M_0$ is the total effective
gravitational mass of the Earth, i.e. the absolute value of the
vectorial gravitational potential on the Earth surface approximately
equals to the escape velocity and its direction is normal to the
Earth, either downward or upward.

Finally, note that the same solution as in
\er{guigjgjffghguygjyfDDnw22khhjhgg22mmhkkhhjjjgioojkjkggjghjggjhhuiiyyiyijkjkhkkhuuhhhhhjgjikuojhjhjhjguyttty99}
and
\er{MaxVacFull1ninshtrgravortghhghgjkgghklhjgkghghjjkjhjkkggjkhjkhjjhhfhjhklkhkhjjklzzzyyyhjggjhgghhjhNWNWBWHWPPN222kkkgtytghjjhpoppkkkhhhkhjhj3499}
or
\er{MaxVacFull1ninshtrgravortghhghgjkgghklhjgkghghjjkjhjkkggjkhjkhjjhhfhjhklkhkhjjklzzzyyyhjggjhgghhjhNWNWBWHWPPN222kkkgtytghjjhpoppkkkhhhkhjhjiuu3499}
is valid also for the laws of the Newtonian-type gravity given by
\er{MaxVacFull1bjkgjhjhgjaaahkjhhjzzzyyykkknnnNWBWHWPPNNewCCmmhhjjhkkk}.
We can get this either by a direct calculations or by passing to the
limit $\beta\to+\infty$ in \er{guigjgjffghguygjyfDDnw22mm34}.
\end{remark}

\subsection{Covariant formulation of the laws of gravity in cartesian and
curvilinear coordinate systems}\label{dvfgdgdkjg}
\subsubsection{The case of the Newtonian type gravity}\label{gggggyffu997tg7gffttft}
In this subsection we find a equivalent form of the Lagrangian
density of the gravitational-electromagnetic field in the case of
the Newtonian-type gravity having the general form
\er{vhfffngghkjgghDDmm}:
\begin{multline}\label{vhfffngghkjgghDDmmioiuiu}
L\left(\vec A,\Psi,\vec v,\Phi,\vec p,\vec
x,t\right):=\frac{1}{8\pi}\left|-\nabla_{\vec
x}\Psi-\frac{1}{c}\frac{\partial\vec A}{\partial t}+\frac{1}{c}\vec
v\times curl_{\vec x}\vec A\right|^2-\frac{1}{8\pi}\left|curl_{\vec
x}\vec A\right|^2-\left(\rho\Psi-\frac{1}{c}\vec A\cdot\vec
j\right)\\+\mu g\left(\frac{1}{2}\left|\vec u-\vec
v\right|^2\right)+
\frac{1}{2}\left(d_{\vec x}\vec v+\left\{d_{\vec x}\vec
v\right\}^T\right):\left(d_{\vec x}\vec p+\left\{d_{\vec x}\vec
p\right\}^T\right)-2\left(div_{\vec x}\vec v\right)\left(div_{\vec
x}\vec p\right)\\+\frac{1}{4\pi G}\left(div_{\vec x}\vec
v\right)\left(\frac{\partial \Phi}{\partial t}+\vec
v\cdot\nabla_{\vec x} \Phi\right)+\frac{1}{4\pi
G}\Phi\left(div_{\vec x}\vec v\right)^2-\frac{\Phi}{16\pi
G}\left|d_{\vec x}\vec v+\left\{d_{\vec x}\vec
v\right\}^T\right|^2+\frac{1}{8\pi G}\left|\nabla_{\vec
x}\Phi\right|^2.
\end{multline}
Our purpose is to make the equivalent form of this Lagrangian
density to be covariant and valid in every \underline{curvilinear}
coordinate system.

Assume first, that we deal with a cartesian inertial or non-inertial
coordinate system. Then consider a three-dimensional vectorial
gravitational potential $\vec v=(v^1,v^2,v^3)$ and consider the
covariant pseudometric tensor $G=\{g_{ij}\}_{0\leq i,j\leq 3}$
defined by \er{hoyuiouigyfg3478zzrrZZffGGhhjhj}:
\begin{equation}\label{hoyuiouigyfg3478zzrrZZff}
\begin{cases}
g_{00}=1-\frac{|\vec v|^2}{c^2}\\
g_{ij}=-\delta_{ij}\quad\forall 1\leq i,j\leq 3\\
g_{0j}=g_{j0}=\frac{v^j}{c}\quad\forall 1\leq j\leq 3.
\end{cases}
\end{equation}
Next consider the contravariant pseudometric tensor $\tilde
G=\{g^{ij}\}_{0\leq i,j\leq 3}$ defined by
\er{hoyuiouigyfghgjh3478zzrrZZffGGjkkj}:
\begin{equation}\label{hoyuiouigyfghgjh3478zzrrZZff}
\begin{cases}
g^{00}=1\\
g^{ij}=-\delta_{ij}+\frac{v^iv^j}{c^2}\quad\forall 1\leq i,j\leq 3\\
g^{0j}=g^{j0}=\frac{v^j}{c}\quad\forall 1\leq j\leq 3.
\end{cases}
\end{equation}

Next consider the Christoffel Symbols:
\begin{equation}\label{jhffhgffgh3478zzrrZZff}
\begin{cases}
\Gamma_{i,kn}:=\frac{1}{2}\left(\frac{\partial g_{ik}}{\partial
x_{n}}+\frac{\partial g_{in}}{\partial x_{k}}-\frac{\partial
g_{kn}}{\partial x_{i}}\right)\\
\Gamma^{i}_{kn}:=\sum_{j=0}^{3}g^{ij}\Gamma_{j,kn}
\end{cases}\quad\quad\forall\, i,k,n=0,1,2,3,
\end{equation}
where $\vec x=(x_1,x_2,x_3)$, $x_0=ct$ and the point in the four
dimensional space-time is denoted as $(x^0,x^1,x^2,x^3):=(ct,\vec
x)=(x_0,x_1,x_2,x_3)$. In particular by
\er{hoyuiouigyfg3478zzrrZZff} and by the first equation in
\er{jhffhgffgh3478zzrrZZff} we obtain:
\begin{equation}\label{jhffhgffghhjgj3478zzrrZZff}
\begin{cases}
\Gamma_{0,00}=-\frac{1}{2c^3}\,\frac{\partial (|\vec v|^2)}{\partial t}\\
\Gamma_{0,k0}=\Gamma_{0,0k}=-\frac{1}{2c^2}\frac{\partial (|\vec
v|^2)}{\partial
x_{k}}\quad\quad\forall\, k=1,2,3,\\
\Gamma_{0,kn}=\frac{1}{2c}\left(\frac{\partial v^{k}}{\partial
x_{n}}+\frac{\partial v^{n}}{\partial
x_{k}}\right)\quad\quad\forall\, k,n=1,2,3
\\
\Gamma_{i,00}=\frac{1}{c^2}\,\left(\frac{\partial v^{i}}{\partial
t}+\frac{1}{2}\frac{\partial (|\vec v|^2)}{\partial
x_{i}}\right)\quad\quad\forall\, i=1,2,3,
\\
\Gamma_{i,k0}=\Gamma_{i,0k}=\frac{1}{2c}\left(\frac{\partial
v^{i}}{\partial x_{k}}-\frac{\partial v^{k}}{\partial
x_{i}}\right)\quad\quad\forall\, i,k=1,2,3,
\\
\Gamma_{i,kn}=0\quad\quad\forall\, i,k,n=1,2,3.
\end{cases}
\end{equation}

Next given a four-covector field $(S_0,S_1,S_2,S_3)$ on the group
$\mathcal{S}_0$ and the corresponding lifted four-vector
$(S^0,S^1,S^2,S^3)$ given by
\begin{equation}\label{fgjfjhgghhgjghjhjkkkkgjghghuiiiulkkjKKjkhkhh}
(S^0,S^1,S^2,S^3):=\Big\{\sum_{k=0}^{3}g^{mk}S_{k}\Big\}_{m=0,1,2,3},
\end{equation}
by \er{hoyuiouigyfghgjh3478zzrrZZffGGjkkj},
\er{hoyuiouigyfg3478zzrrZZffGGhhjhj} and
\er{fgjfjhgghhgjghjhjkkkkgjghghuiiiulkkjKK} we have:
\begin{equation}\label{fgjfjhgghhgjghjhjkkkkgjghghuiiiulkkjKKyuyyujojkjk}
S^0=S_{0}+\sum_{k=1}^{3}\frac{1}{c}v^k S_{k}\quad\text{and}\quad
S^m=-S_{m}+\frac{1}{c}\bigg(S_{0}+\sum_{k=1}^{3}\frac{1}{c}v^k
S_{k}\bigg)v^m\quad\quad\forall m=1,2,3,
\end{equation}
On the other hand, if we denote:
\begin{equation}\label{hoyuiouigyfghgjh3478zzrrZZjhjhhjkjjhnmnmffjjpopo}
\begin{cases}
\phi:=S^0\\
h_j:=-S_j\quad\quad\forall 1\leq j\leq 3\quad\text{and}\quad\vec
h:=(h_1,h_2,h_3),
\end{cases}
\end{equation}
then, as we get above in subsection \ref{jjjjjkkk}, $\phi$ is a
proper scalar field and $\vec h$ is a proper vector field and by
\er{fgjfjhgghhgjghjhjkkkkgjghghuiiiulkkjKKyuyyujojkjk} we can write:
\begin{equation}\label{hoyuiouigyfghgjh3478zzrrZZjhjhhjkjjhnmnmffjj}
\begin{cases}
S_0=\phi+\sum_{k=1}^{3}\frac{1}{c}v^k h_{k}=\phi+\frac{1}{c}\vec v\cdot\vec h\\
S_j=-h_j\quad\quad\forall 1\leq j\leq 3.
\end{cases}
\end{equation}
and
\begin{equation}\label{hoyuiouigyfghgjh3478zzrrZZjhjhhjkjjhnmnmffjjkkk}
\begin{cases}
S^0=\phi\\
S^j=\phi \frac{v^j}{c}+h_j\quad\quad\forall 1\leq j\leq 3.
\end{cases}
\end{equation}
Next consider $Z_{ij}=\delta_j S_i$, where $\delta_j$ means the
covariant derivative with respect to the dynamical pseudo-metrics
$G=\{g_{ij}\}_{0\leq i,j\leq 3}$. Then
\begin{equation}\label{hoyuiouigyfghgjh3478zzrrZZjhjhugyuyughggjhjhjyuuygtyffjj}
Z_{ij}:=\delta_j S_i:=\frac{\partial S_i}{\partial
x_j}-\sum_{k=0}^{3}\Gamma^{k}_{ij}S_k\quad\quad\forall\, 0\leq
i,j\leq 3.
\end{equation}
It is well known from the Tensor Analysis that, given a co-vector
field $S_j$ on the group $\mathcal{S}_0$, $Z_{ij}$, defined by
\er{hoyuiouigyfghgjh3478zzrrZZjhjhugyuyughggjhjhjyuuygtyffjj}, is a
two times covariant tensor field on the group $\mathcal{S}_0$. On
the other hand we can write by
\er{hoyuiouigyfghgjh3478zzrrZZjhjhugyuyughggjhjhjyuuygtyffjj} and by
the second equality in \er{jhffhgffgh3478zzrrZZff}:
\begin{equation}\label{hoyuiouigyfghgjh3478zzrrZZjhjhugyuyughggjhjhjyuuygtyffjjjkj}
Z_{ij}:=\delta_j S_i:=\frac{\partial S_i}{\partial
x_j}-\sum_{k=0}^{3}\sum_{m=0}^{3}g^{km}\Gamma_{m,ij}\,
S_k=\frac{\partial S_i}{\partial x_j}-\sum_{m=0}^{3}\Gamma_{m,ij}\,
S^m\quad\quad\forall\, 0\leq i,j\leq 3,
\end{equation}
Thus by inserting \er{hoyuiouigyfghgjh3478zzrrZZjhjhhjkjjhnmnmffjj}
and \er{hoyuiouigyfghgjh3478zzrrZZjhjhhjkjjhnmnmffjjkkk} into
\er{hoyuiouigyfghgjh3478zzrrZZjhjhugyuyughggjhjhjyuuygtyffjjjkj} we
deduce:
\begin{multline}\label{hoyuiouigyfghgjh3478zzrrZZjhjhugyuyughggjhjhjyuuygtyffjjjkjljkjj}
Z_{ij}:=\delta_j S_i=\frac{\partial S_i}{\partial
x_j}-\Gamma_{0,ij}\, S^0-\sum_{m=1}^{3}\Gamma_{m,ij}\, S^m
\\=\frac{\partial S_i}{\partial x_j}-\Gamma_{0,ij}\,
\phi-\sum_{m=1}^{3}\Gamma_{m,ij}\, h_m-\sum_{m=1}^{3}\Gamma_{m,ij}\,
 \frac{\phi \,v^m}{c}\quad\quad\forall\, 0\leq i,j\leq 3,
\end{multline}
In particular by inserting
\er{hoyuiouigyfghgjh3478zzrrZZjhjhhjkjjhnmnmffjj} and
\er{jhffhgffghhjgj3478zzrrZZff} into
\er{hoyuiouigyfghgjh3478zzrrZZjhjhugyuyughggjhjhjyuuygtyffjjjkjljkjj}
we deduce:
\begin{multline}\label{hoyuiouigyfghgjh3478zzrrZZjhjhugyuyughggjhjhjyuuygtyffjjjkjljkjjljokjl}
Z_{00}:=\delta_0 S_0=\frac{\partial S_0}{\partial
x_0}-\Gamma_{0,00}\, \phi-\sum_{m=1}^{3}\Gamma_{m,00}\,
h_m-\sum_{m=1}^{3}\Gamma_{m,00}\, \frac{\phi \,v^m}{c}=
\\
\frac{1}{c}\frac{\partial}{\partial t}\left(\phi+\frac{1}{c}\vec
v\cdot\vec h\right)+\frac{\phi}{2c^3}\,\frac{\partial (|\vec
v|^2)}{\partial t}-\sum_{m=1}^{3}\frac{1}{c^2}\,\left(\frac{\partial
v^{m}}{\partial t}+\frac{1}{2}\frac{\partial (|\vec v|^2)}{\partial
x_{m}}\right)\,
h_m-\sum_{m=1}^{3}\frac{1}{c^2}\,\left(\frac{\partial
v^{m}}{\partial t}+\frac{1}{2}\frac{\partial (|\vec v|^2)}{\partial
x_{m}}\right)\,
 \frac{\phi \,v^m}{c}\\=\frac{1}{c}\frac{\partial\phi}{\partial t}+\frac{1}{c^2}\vec v \cdot\frac{\partial\vec h}{\partial
 t}-\sum_{m=1}^{3}\frac{1}{2c^2}\,\left(\frac{\partial (|\vec v|^2)}{\partial
x_{m}}\right)\, \left(h_m+\frac{\phi \,v^m}{c}\right)
\end{multline}
\begin{multline}\label{hoyuiouigyfghgjh3478zzrrZZjhjhugyuyughggjhjhjyuuygtyffjjjkjljkjjljokjlpiikk}
Z_{0j}:=\delta_j S_0=\frac{\partial S_0}{\partial
x_j}-\Gamma_{0,0j}\, \phi-\sum_{m=1}^{3}\Gamma_{m,0j}\,
h_m-\sum_{m=1}^{3}\Gamma_{m,0j}\,
 \frac{\phi \,v^m}{c}=\\ \frac{\partial}{\partial
x_j}\left(\phi+\frac{1}{c}\vec v\cdot\vec
h\right)+\frac{\phi}{2c^2}\frac{\partial (|\vec v|^2)}{\partial
x_{j}}-\sum_{m=1}^{3}\frac{1}{2c}\left(\frac{\partial
v^{m}}{\partial x^{j}}-\frac{\partial v^{j}}{\partial
x_{m}}\right)\, h_m-\sum_{m=1}^{3}\frac{1}{2c}\left(\frac{\partial
v^{m}}{\partial x_{j}}-\frac{\partial v^{j}}{\partial
x_{m}}\right)\,
 \frac{\phi \,v^m}{c}
 =\\ \frac{\partial\phi}{\partial
x_j}+\sum_{m=1}^{3}\frac{v^m}{c}\frac{\partial h_m}{\partial x_j}
+\sum_{m=1}^{3}\frac{1}{2c}\left(\frac{\partial v^{j}}{\partial
x_{m}}+\frac{\partial v^{m}}{\partial x_{j}}\right)\,
h_m+\sum_{m=1}^{3}\frac{1}{2c}\left(\frac{\partial v^{j}}{\partial
x_{m}}+\frac{\partial v^{m}}{\partial x_{j}}\right)\,
 \frac{\phi \,v^m}{c}\quad\quad\forall\, 1\leq j\leq 3,
\end{multline}
\begin{multline}\label{hoyuiouigyfghgjh3478zzrrZZjhjhugyuyughggjhjhjyuuygtyffjjjkjljkjjljokjlkllkk}
Z_{i0}:=\delta_0 S_i=\frac{\partial S_i}{\partial
x_0}-\Gamma_{0,i0}\, \phi-\sum_{m=1}^{3}\Gamma_{m,i0}\,
h_m-\sum_{m=1}^{3}\Gamma_{m,i0}\,
 \frac{\phi \,v^m}{c}=\\ -\frac{1}{c}\frac{\partial h_i}{\partial
t}+\frac{\phi}{2c^2}\frac{\partial (|\vec v|^2)}{\partial
x_{i}}-\sum_{m=1}^{3}\frac{1}{2c}\left(\frac{\partial
v^{m}}{\partial x_{i}}-\frac{\partial v^{i}}{\partial
x_{m}}\right)\, h_m-\sum_{m=1}^{3}\frac{1}{2c}\left(\frac{\partial
v^{m}}{\partial x_{i}}-\frac{\partial v^{i}}{\partial
x_{m}}\right)\,
 \frac{\phi \,v^m}{c}
=\\ -\frac{1}{c}\frac{\partial h_i}{\partial
t}+\sum_{m=1}^{3}\frac{1}{2c}\left(\frac{\partial v^{i}}{\partial
x_{m}}-\frac{\partial v^{m}}{\partial x_{i}}\right)\,
h_m+\sum_{m=1}^{3}\frac{1}{2c}\left(\frac{\partial v^{m}}{\partial
x_{i}}+\frac{\partial v^{i}}{\partial x_{m}}\right)\,
 \frac{\phi \,v^m}{c}
 \quad\quad\forall\, 1\leq i\leq 3,
\end{multline}
and
\begin{multline}\label{hoyuiouigyfghgjh3478zzrrZZjhjhugyuyughggjhjhjyuuygtyffjjjkjljkjjljokjllll}
Z_{ij}:=\delta_j S_i=\frac{\partial S_i}{\partial
x_j}-\Gamma_{0,ij}\, \phi-\sum_{m=1}^{3}\Gamma_{m,ij}\,
h_m-\sum_{m=1}^{3}\Gamma_{m,ij}\,
 \frac{\phi \,v^m}{c}\\=
-\left(\frac{\partial h_i}{\partial
x_j}+\frac{\phi}{2c}\left(\frac{\partial v^{i}}{\partial
x_{j}}+\frac{\partial v^{j}}{\partial x_{i}}\right)\right)
 \quad\quad\forall\, 1\leq i,j\leq 3.
\end{multline}
Therefore, if we denote the symmetric two times covariant tensor
$\hat Z_{ij}$ defined by
\begin{equation}\label{hoyuiouigyfghgjh3478zzrrZZjhjhugyuyughggjhjhjyuuygtyffjjkpkkk}
\hat Z_{ij}:=Z_{ij}+Z_{ji}=\delta_j S_i+\delta_i S_j=\frac{\partial
S_i}{\partial x_j}+\frac{\partial S_j}{\partial
x_i}-\sum_{k=0}^{3}2\Gamma^{k}_{ij}S_k\quad\quad\forall\, 0\leq
i,j\leq 3,
\end{equation}
then by
\er{hoyuiouigyfghgjh3478zzrrZZjhjhugyuyughggjhjhjyuuygtyffjjjkjljkjjljokjl},
\er{hoyuiouigyfghgjh3478zzrrZZjhjhugyuyughggjhjhjyuuygtyffjjjkjljkjjljokjlpiikk},
\er{hoyuiouigyfghgjh3478zzrrZZjhjhugyuyughggjhjhjyuuygtyffjjjkjljkjjljokjlkllkk}
and
\er{hoyuiouigyfghgjh3478zzrrZZjhjhugyuyughggjhjhjyuuygtyffjjjkjljkjjljokjllll}
we have:
\begin{multline}\label{hoyuiouigyfghgjh3478zzrrZZjhjhugyuyughggjhjhjyuuygtyffjjjkjljkjjljokjljjkjkopopkklkl}
\hat Z_{00}=\frac{2}{c}\frac{\partial\phi}{\partial
t}+\sum_{m=1}^{3}\frac{2v^m}{c^2}\frac{\partial h_m}{\partial
 t}-\sum_{m=1}^{3}\frac{1}{c^2}\,\left(\frac{\partial (|\vec v|^2)}{\partial
x_{m}}\right)\, \left(h_m+\frac{\phi \,v^m}{c}\right)=\\
\frac{2}{c}\left(\frac{\partial\phi}{\partial
t}+\sum_{m=1}^{3}v^m\frac{\partial \phi}{\partial
x_m}\right)-\sum_{m=1}^{3}\frac{2v^m}{c}\left(\frac{\partial
\phi}{\partial x_m}-\frac{1}{c}\frac{\partial h_m}{\partial
 t}-\sum_{k=1}^{3}\frac{v^k}{c}\frac{\partial h_{m}}{\partial
x_{k}}+\sum_{k=1}^{3}\frac{h_k}{c}\frac{\partial v^{m}}{\partial
x_{k}}\right)\\-\sum_{m=1}^{3}\frac{v^m}{c}\left(\sum_{k=1}^{3}\frac{v_k}{c}\left(\left(\frac{\partial
h_{k}}{\partial x_{m}}+\frac{\partial h_{m}}{\partial
x_{k}}\right)+\left(\frac{\partial v^{k}}{\partial
x_{m}}+\frac{\partial v^{m}}{\partial
x_{k}}\right)\frac{\phi}{c}\right)\right)
\end{multline}
\begin{multline}\label{hoyuiouigyfghgjh3478zzrrZZjhjhugyuyughggjhjhjyuuygtyffjjjkjljkjjljokjlpiikkkk}
\hat Z_{0j}=\hat Z_{j0}= \frac{\partial\phi}{\partial
x_j}+\sum_{m=1}^{3}\frac{v^m}{c}\frac{\partial h_m}{\partial x_j}
+\sum_{m=1}^{3}\frac{1}{2c}\left(\frac{\partial v^{j}}{\partial
x_{m}}+\frac{\partial v^{m}}{\partial x_{j}}\right)\,
h_m+\sum_{m=1}^{3}\frac{1}{2c}\left(\frac{\partial v^{j}}{\partial
x_{m}}+\frac{\partial v^{m}}{\partial x_{j}}\right)\,
 \frac{\phi \,v^m}{c}\\ -\frac{1}{c}\frac{\partial h_j}{\partial
t}+\sum_{m=1}^{3}\frac{1}{2c}\left(\frac{\partial v^{j}}{\partial
x_{m}}-\frac{\partial v^{m}}{\partial x_{j}}\right)\,
h_m+\sum_{m=1}^{3}\frac{1}{2c}\left(\frac{\partial v^{m}}{\partial
x_{j}}+\frac{\partial v^{j}}{\partial x_{m}}\right)\,
 \frac{\phi \,v^m}{c}=\\
\frac{\partial\phi}{\partial x_j}-\frac{1}{c}\frac{\partial
h_j}{\partial t}-\sum_{m=1}^{3}\frac{v^m}{c}\frac{\partial
h_j}{\partial x_m}
+\sum_{m=1}^{3}\frac{h_m}{c}\frac{\partial v^{j}}{\partial
x_{m}}+\sum_{m=1}^{3}\frac{v^m}{c}\left(\left(\frac{\partial
h_{j}}{\partial x_{m}}+\frac{\partial h_{m}}{\partial
x_{j}}\right)+\left(\frac{\partial v^{j}}{\partial
x_{m}}+\frac{\partial v^{m}}{\partial
x_{j}}\right)\frac{\phi}{c}\right)
 \\
 \quad\quad\forall\, 1\leq j\leq 3,
\end{multline}
and
\begin{equation}\label{hoyuiouigyfghgjh3478zzrrZZjhjhugyuyughggjhjhjyuuygtyffjjjkjljkjjljokjlllljkkjjk}
\hat Z_{ij}=\hat Z_{ji}= -\left(\left(\frac{\partial h_i}{\partial
x_j}+\frac{\partial h_j}{\partial
x_i}\right)+\frac{\phi}{c}\left(\frac{\partial v^{i}}{\partial
x_{j}}+\frac{\partial v^{j}}{\partial x_{i}}\right)\right)
 \quad\quad\forall\, 1\leq i,j\leq 3.
\end{equation}
Therefore, by
\er{hoyuiouigyfghgjh3478zzrrZZjhjhugyuyughggjhjhjyuuygtyffjjjkjljkjjljokjlllljkkjjk},
\er{hoyuiouigyfghgjh3478zzrrZZjhjhugyuyughggjhjhjyuuygtyffjjjkjljkjjljokjlpiikkkk}
and
\er{hoyuiouigyfghgjh3478zzrrZZjhjhugyuyughggjhjhjyuuygtyffjjjkjljkjjljokjljjkjkopopkklkl}
we obtain:
\begin{equation}\label{hoyuiouigyfghgjh3478zzrrZZjhjhugyuyughggjhjhjyuuygtyffjjjkjljkjjljokjlllljkkjjkp;;lk}
\begin{cases}\hat Z_{ij}=\hat Z_{ji}= -\left(\left(\frac{\partial
h_i}{\partial x_j}+\frac{\partial h_j}{\partial
x_i}\right)+\frac{\phi}{c}\left(\frac{\partial v^{i}}{\partial
x_{j}}+\frac{\partial v^{j}}{\partial x_{i}}\right)\right)
\quad\quad\forall\, 1\leq i,j\leq 3,\\
\hat Z_{0j}=\hat Z_{j0}= \left(\frac{\partial\phi}{\partial
x_j}-\frac{1}{c}\frac{\partial h_j}{\partial
t}-\sum_{m=1}^{3}\frac{v^m}{c}\frac{\partial h_j}{\partial x_m}
+\sum_{m=1}^{3}\frac{h_m}{c}\frac{\partial v^{j}}{\partial
x_{m}}\right)-\sum_{m=1}^{3}\frac{v^m}{c}\hat Z_{mj}
\quad\quad\forall\, 1\leq j\leq
3\\
\hat Z_{00}=\frac{2}{c}\left(\frac{\partial\phi}{\partial
t}+\sum_{m=1}^{3}v^m\frac{\partial \phi}{\partial
x_m}\right)-\sum_{m=1}^{3}\frac{2v^m}{c}\left(\hat
Z_{0m}+\sum_{k=1}^{3}\frac{v^k}{c}\hat Z_{mk}\right)+
\sum_{m=1}^{3}\sum_{k=1}^{3}\frac{v^m}{c}\frac{v^k}{c}\hat Z_{mk}
\end{cases}
\end{equation}
In particular, by
\er{hoyuiouigyfghgjh3478zzrrZZjhjhugyuyughggjhjhjyuuygtyffjjjkjljkjjljokjlllljkkjjkp;;lk}
and \er{hoyuiouigyfghgjh3478zzrrZZff} we deduce:
\begin{multline}\label{hoyuiouigyfghgjh3478zzrrZZjhjhugyuyughggjhjhjyuuygtyffjjjkjljkjjljokjljjkjkopopkklklkjjkjkjjkkj}
\sum_{i=0}^{3}\sum_{j=0}^{3}g^{ij}\hat Z_{ij}=g^{00}\hat
Z_{00}+\sum_{j=1}^{3}2g^{0j}\hat
Z_{0j}+\sum_{i=1}^{3}\sum_{j=1}^{3}g^{ij}\hat Z_{ij}= \hat
Z_{00}+\sum_{j=1}^{3}\frac{2v^j}{c}\hat
Z_{0j}+\sum_{i=1}^{3}\sum_{j=1}^{3}\frac{v^i}{c}\frac{v^j}{c}\hat
Z_{ij}-\sum_{i=1}^{3}\hat Z_{ii}\\
= \left(\frac{2}{c}\left(\frac{\partial\phi}{\partial
t}+\sum_{m=1}^{3}v^m\frac{\partial \phi}{\partial
x_m}\right)-\sum_{m=1}^{3}\frac{2v^m}{c}\left(\hat
Z_{0m}+\sum_{k=1}^{3}\frac{v^k}{c}\hat Z_{mk}\right)+
\sum_{m=1}^{3}\sum_{k=1}^{3}\frac{v^m}{c}\frac{v^k}{c}\hat
Z_{mk}\right)\\+\sum_{j=1}^{3}\frac{2v^j}{c}\hat
Z_{0j}+\sum_{i=1}^{3}\sum_{j=1}^{3}\frac{v^i}{c}\frac{v^j}{c}\hat
Z_{ij}+\sum_{i=1}^{3}2\left(\frac{\partial h_i}{\partial
x_i}+\frac{\phi}{c}\frac{\partial v^{i}}{\partial x_{i}}\right)=
2\left(\frac{1}{c}\left(\frac{\partial\phi}{\partial t}+div_{\vec
x}\left\{\phi\vec v\right\}\right)+div_{\vec x}\vec h\right).
\end{multline}

In particular, if we consider the four-dimensional gravitational
potential $(v^0,v^1,v^2,v^3)$ defined by
\er{fgjfjhgghhgjghjhjijhojihjhjjijhjjjjjuiijjjk} as:
\begin{equation}\label{fgjfjhgghhgjghjhjijhojihjhjjijhjjjjjuiijjjkhjhjhjklkk}
(v^0,v^1,v^2,v^3):=\left(1,\frac{1}{c}\vec v\right),
\end{equation}
and the corresponding lowered four-covector field
$(v_0,v_1,v_2,v_3)$, that we called the four-covector of
gravitational potential:
\begin{equation}\label{fgjfjhgghhgjghjhjijhojihjhjjijhjjjjjuiikkjjnj1kkkl}
(v_0,v_1,v_2,v_3):=\left(1,0,0,0\right),
\end{equation}
then denoting
\begin{equation}\label{fgjfjhgghhghjjgj}
\hat Y_{ij}:=\delta_j v_i+\delta_i v_j\quad\quad\forall\, 0\leq
i,j\leq 3,
\end{equation}
as a particular case of
\er{hoyuiouigyfghgjh3478zzrrZZjhjhugyuyughggjhjhjyuuygtyffjjjkjljkjjljokjlllljkkjjkp;;lk}
and
\er{hoyuiouigyfghgjh3478zzrrZZjhjhugyuyughggjhjhjyuuygtyffjjjkjljkjjljokjljjkjkopopkklklkjjkjkjjkkj}
with $\phi=1$ and $\vec h=0$, we deduce:
\begin{equation}\label{hoyuiouigyfghgjh3478zzrrZZjhjhugyuyughggjhjhjyuuygtyffjjjkjljkjjljokjlllljkkjjkp;;lkghhgghlllk;l;llklk}
\begin{cases}\hat Y_{ij}=\hat Y_{ji}= -\frac{1}{c}\left(\frac{\partial v^{i}}{\partial
x_{j}}+\frac{\partial v^{j}}{\partial x_{i}}\right)
\quad\quad\forall\, 1\leq i,j\leq 3,\\
\hat Y_{0j}=\hat Y_{j0}= -\sum_{m=1}^{3}\frac{v^m}{c}\hat Y_{mj}
\quad\quad\forall\, 1\leq j\leq
3\\
\hat Y_{00}=-\sum_{m=1}^{3}\frac{2v^m}{c}\left(\hat
Y_{0m}+\sum_{k=1}^{3}\frac{v^k}{c}\hat Y_{mk}\right)+
\sum_{m=1}^{3}\sum_{k=1}^{3}\frac{v^m}{c}\frac{v^k}{c}\hat Y_{mk},
\end{cases}
\end{equation}
and
\begin{equation}\label{hoyuiouigyfghgjh3478zzrrZZjhjhugyuyughggjhjhjyuuygtyffjjjkjljkjjljokjljjkjkopopkklklkjjkjkjjkkokkjoipio;l;l;ll;lkklkjjk}
\sum_{i=0}^{3}\sum_{j=0}^{3}g^{ij}\hat Y_{ij}=
\frac{2}{c}\left(div_{\vec x}\vec v\right).
\end{equation}
Moreover, denoting the two times contravariant lifted tensor:
\begin{equation}\label{hoyuiouigyfghgjh3478zzrrZZjhjhugyuyughggjhjhjyuuygtyffjjjkjljkjjljokjlllljkkjjkp;;lkkklijjkll}
\hat Y^{mn}:=\sum_{i=0}^{3}\sum_{j=0}^{3}g^{mj}g^{in}\hat
Y_{ij}\quad\quad\forall\, 0\leq m,n\leq 3,
\end{equation}
since then
\begin{equation}\label{hoyuiouigyfghgjh3478zzrrZZjhjhugyuyughggjhjhjyuuygtyffjjjkjljkjjljokjlllljkkjjkp;;lkkklijjkklklklll}
\hat Y^{mn}=g^{m0}g^{0n}\hat Y_{ij}+\sum_{i=1}^{3}g^{m0}g^{in}\hat
Y_{i0}+\sum_{j=1}^{3}g^{mj}g^{0n}\hat
Y_{0j}+\sum_{i=1}^{3}\sum_{j=1}^{3}g^{mj}g^{in}\hat
Y_{ij}\quad\quad\forall\, 0\leq m,n\leq 3,
\end{equation}
by
\er{hoyuiouigyfghgjh3478zzrrZZjhjhugyuyughggjhjhjyuuygtyffjjjkjljkjjljokjlllljkkjjkp;;lkghhgghlllk;l;llklk}
and \er{hoyuiouigyfghgjh3478zzrrZZff} we also deduce:
\begin{multline}\label{hoyuiouigyfghgjh3478zzrrZZjhjhugyuyughggjhjhjyuuygtyffjjjkjljkjjljokjlllljkkjjkp;;lkkklijjkklklkll;l;ghll}
\hat Y^{00}=g^{00}g^{00}\hat Y_{00}+\sum_{i=1}^{3}g^{00}g^{i0}\hat
Y_{i0}+\sum_{j=1}^{3}g^{0j}g^{00}\hat
Y_{0j}+\sum_{i=1}^{3}\sum_{j=1}^{3}g^{0j}g^{i0}\hat Y_{ij}=\\
\hat Y_{00}+\sum_{j=1}^{3}\frac{2v_j}{c}\hat
Y_{0j}+\sum_{i=1}^{3}\sum_{j=1}^{3}\frac{v^i}{c}\frac{v^j}{c}\hat
Y_{ij}=0,
\end{multline}
\begin{multline}\label{hoyuiouigyfghgjh3478zzrrZZjhjhugyuyughggjhjhjyuuygtyffjjjkjljkjjljokjlllljkkjjkp;;lkkklijjkklklkl;ll;hll}
\hat Y^{0n}=\hat Y^{n0}=g^{00}g^{0n}\hat
Y_{00}+\sum_{i=1}^{3}g^{00}g^{in}\hat
Y_{i0}+\sum_{j=1}^{3}g^{0j}g^{0n}\hat
Z_{0j}+\sum_{i=1}^{3}\sum_{j=1}^{3}g^{0j}g^{in}\hat Y_{ij}=
\frac{v^n}{c}\hat
Y_{00}\\+\sum_{i=1}^{3}\left(-\delta_{in}+\frac{v^iv^n}{c^2}\right)\hat
Y_{i0}+\sum_{j=1}^{3}\frac{v^j}{c}\frac{v^n}{c}\hat
Y_{0j}+\sum_{i=1}^{3}\sum_{j=1}^{3}\frac{v^j}{c}\left(-\delta_{in}+\frac{v^iv^n}{c^2}\right)\hat
Y_{ij} \\=-\left(\hat Y_{n0}+\sum_{j=1}^{3}\frac{v^j}{c}\hat
Y_{nj}\right)=0\quad\quad\forall\, 1\leq n\leq 3,
\end{multline}
and
\begin{multline}\label{hoyuiouigyfghgjh3478zzrrZZjhjhugyuyughggjhjhjyuuygtyffjjjkjljkjjljokjlllljkkjjkp;;lkkklijjkklklkllkjll}
\hat Y^{mn}
=g^{m0}g^{0n}\hat Y_{00}+\sum_{i=1}^{3}g^{m0}g^{in}\hat
Y_{i0}+\sum_{j=1}^{3}g^{mj}g^{0n}\hat
Y_{0j}+\sum_{i=1}^{3}\sum_{j=1}^{3}g^{mj}g^{in}\hat Y_{ij} =
\frac{v^mv^n}{c^2}\hat
Y_{00}\\+\sum_{i=1}^{3}\frac{v^m}{c}\left(-\delta_{in}+\frac{v^iv^n}{c^2}\right)\hat
Y_{i0}+\sum_{j=1}^{3}\left(-\delta_{mj}+\frac{v^mv^j}{c^2}\right)\frac{v^n}{c}\hat
Y_{0j}+\sum_{i=1}^{3}\sum_{j=1}^{3}\left(-\delta_{mj}+\frac{v^mv^j}{c^2}\right)\left(-\delta_{in}+\frac{v^iv^n}{c^2}\right)\hat
Y_{ij}
\\
= -\frac{v^m}{c}\left(\hat Y_{n0}+\sum_{j=1}^{3}\frac{v^j}{c}\hat
Y_{nj}\right)-\frac{v^n}{c}\left(\hat Y_{0m}
+\sum_{i=1}^{3}\frac{v^i}{c}\hat Y_{im}\right) +\hat Y_{mn}=
-\frac{1}{c}\left(\frac{\partial v^{m}}{\partial
x_{n}}+\frac{\partial v^{n}}{\partial x_{m}}\right) \;\;\forall
m,n=1,2,3.
\end{multline}
As a consequence of
\er{hoyuiouigyfghgjh3478zzrrZZjhjhugyuyughggjhjhjyuuygtyffjjjkjljkjjljokjlllljkkjjkp;;lkkklijjkklklkll;l;ghll},
\er{hoyuiouigyfghgjh3478zzrrZZjhjhugyuyughggjhjhjyuuygtyffjjjkjljkjjljokjlllljkkjjkp;;lkkklijjkklklkl;ll;hll}
and
\er{hoyuiouigyfghgjh3478zzrrZZjhjhugyuyughggjhjhjyuuygtyffjjjkjljkjjljokjlllljkkjjkp;;lkkklijjkklklkllkjll}
we have:
\begin{equation}\label{hoyuiouigyfghgjh3478zzrrZZjhjhugyuyughggjhjhjyuuygtyffjjjkjljkjjljokjlllljkkjjkp;;lkkljljklkpkopkpolol;l;;lklklklll}
\begin{cases}
\hat Y^{00}=0,\\
\hat Y^{0n}=\hat Y^{n0}=0\quad\quad\forall\, 1\leq n\leq 3,
\\
\hat Y^{mn}
= -\frac{1}{c}\left(\frac{\partial v^{m}}{\partial
x_{n}}+\frac{\partial v^{n}}{\partial x_{m}}\right)
\quad\quad\forall\, 1\leq m,n\leq 3.
\end{cases}
\end{equation}
Therefore, in particular, by
\er{hoyuiouigyfghgjh3478zzrrZZjhjhugyuyughggjhjhjyuuygtyffjjjkjljkjjljokjlllljkkjjkp;;lk}
and
\er{hoyuiouigyfghgjh3478zzrrZZjhjhugyuyughggjhjhjyuuygtyffjjjkjljkjjljokjlllljkkjjkp;;lkkljljklkpkopkpolol;l;;lklklklll}
we obtain:
\begin{multline}\label{hoyuiouigyfghgjh3478zzrrZZjhjhugyuyughggjhjhjyuuygtyffjjjkjljkjjljokjlllljkkjjkp;;lkkljljklkpkopkpolol;l;;lklklkllll;l;kkkl}
\sum_{i=0}^{3}\sum_{j=0}^{3}\hat Z_{ij}\hat
Y^{ij}=\sum_{i=1}^{3}\sum_{j=1}^{3}\frac{1}{c}\left(\left(\frac{\partial
h_i}{\partial x_j}+\frac{\partial h_j}{\partial
x_i}\right)+\frac{\phi}{c}\left(\frac{\partial v^{i}}{\partial
x_{j}}+\frac{\partial v^{j}}{\partial
x_{i}}\right)\right)\left(\frac{\partial v^{i}}{\partial
x_{j}}+\frac{\partial v^{j}}{\partial x_{i}}\right)\\=
\frac{1}{c}\left(d_{\vec x}\vec v+\left\{d_{\vec x}\vec
v\right\}^T\right):\left(d_{\vec x}\vec h+\left\{d_{\vec x}\vec
h\right\}^T\right)+\frac{\phi}{c^2}\left|d_{\vec x}\vec
v+\left\{d_{\vec x}\vec v\right\}^T\right|^2.
\end{multline}
On the other hand by
\er{hoyuiouigyfghgjh3478zzrrZZjhjhugyuyughggjhjhjyuuygtyffjjjkjljkjjljokjljjkjkopopkklklkjjkjkjjkkj}
and
\er{hoyuiouigyfghgjh3478zzrrZZjhjhugyuyughggjhjhjyuuygtyffjjjkjljkjjljokjljjkjkopopkklklkjjkjkjjkkokkjoipio;l;l;ll;lkklkjjk}
we have
\begin{multline}\label{hoyuiouigyfghgjh3478zzrrZZjhjhugyuyughggjhjhjyuuygtyffjjjkjljkjjljokjljjkjkopopkklklkjjkjkjjkkokkjoipio;l;l;ll;lkklkjjkkklk}
\left(\sum_{i=0}^{3}\sum_{j=0}^{3}g^{ij}\hat
Y_{ij}\right)\left(\sum_{i=0}^{3}\sum_{j=0}^{3}g^{ij}\hat
Z_{ij}\right)= \frac{4}{c}\left(div_{\vec x}\vec
v\right)\left(\frac{1}{c}\left(\frac{\partial\phi}{\partial
t}+div_{\vec x}\left\{\phi\vec v\right\}\right)+div_{\vec x}\vec
h\right)\\= \frac{4}{c}\left(div_{\vec x}\vec
h\right)\left(div_{\vec x}\vec
v\right)+\frac{4\phi}{c^2}\left(div_{\vec x}\vec
v\right)^2+\frac{4}{c^2}\left(div_{\vec x}\vec
v\right)\left(\frac{\partial\phi}{\partial t}+\vec
v\cdot\nabla_{\vec x}\phi\right).
\end{multline}
Thus by
\er{hoyuiouigyfghgjh3478zzrrZZjhjhugyuyughggjhjhjyuuygtyffjjjkjljkjjljokjlllljkkjjkp;;lkkljljklkpkopkpolol;l;;lklklkllll;l;kkkl}
and
\er{hoyuiouigyfghgjh3478zzrrZZjhjhugyuyughggjhjhjyuuygtyffjjjkjljkjjljokjljjkjkopopkklklkjjkjkjjkkokkjoipio;l;l;ll;lkklkjjkkklk}
we deduce:
\begin{multline}\label{hoyuiouigyfghgjh3478zzrrZZjhjhugyuyughggjhjhjyuuygtyffjjjkjljkjjljokjljjkjkopopkklklkjjkjkjjkkokkjoipio;l;l;ll;lkklkjjkkklkkkkk}
\left(\sum_{i=0}^{3}\sum_{j=0}^{3}g^{ij}\hat
Y_{ij}\right)\left(\sum_{i=0}^{3}\sum_{j=0}^{3}g^{ij}\hat
Z_{ij}\right)-\sum_{i=0}^{3}\sum_{j=0}^{3}\hat Z_{ij}\hat Y^{ij}
\\= \frac{4}{c}\left(\left(div_{\vec x}\vec h\right)\left(div_{\vec x}\vec
v\right)-\frac{1}{4}\left(d_{\vec x}\vec v+\left\{d_{\vec x}\vec
v\right\}^T\right):\left(d_{\vec x}\vec h+\left\{d_{\vec x}\vec
h\right\}^T\right)\right)\\+\frac{4}{c^2}\left(\left(div_{\vec
x}\vec v\right)\left(\frac{\partial\phi}{\partial t}+\vec
v\cdot\nabla_{\vec x}\phi\right)+\phi\left(\left(div_{\vec x}\vec
v\right)^2-\frac{1}{4}\left|d_{\vec x}\vec v+\left\{d_{\vec x}\vec
v\right\}^T\right|^2\right)\right).
\end{multline}
Note again that by
\er{fgjfjhgghhgjghjhjijhojihjhjjijhjjjjjuiikkjjnj1kkkl} and
\er{fgjfjhgghhgjghjhjkkkkgjghghuiiiulkkjlkklplikklkjljghhgghk} the
four-covector of gravitational potential is a gradient of the global
time multiplied by the constant $c$:
\begin{equation}\label{fgjfjhgghhgjghjhjijhojihjhjjijhjjjjjuiikkjjnj1kkkljkhjjhhjklkk}
(v_0,v_1,v_2,v_3):=\left(1,0,0,0\right)=c\left(\frac{\partial
t}{\partial x^0},\frac{\partial t}{\partial x^1},\frac{\partial
t}{\partial x^2},\frac{\partial t}{\partial x^1}\right).
\end{equation}

Next by \er{hoyuiouigyfghgjh3478zzrrZZjhjhhjkjjhnmnmffjjpopo} and
\er{fgjfjhgghhgjghjhjijhojihjhjjijhjjjjjuiikkjjnj1kkkl} we deduce
that the covariant scalar $\sum_{k=0}^{3}v_kS^k$ satisfies in
cartesian coordinate system:
\begin{equation}\label{ojkkjjkjklkj}
\sum_{k=0}^{3}v_kS^k=\phi.
\end{equation}
Therefore by \er{fgjfjhgghhgjghjhjkkkkgjghghuiiiulkkjlkklplikkl} we
deduce that the four-component field $(d_0,d_1,d_2,d_3)$ defined by:
\begin{equation}\label{fgjfjhgghhgjghjhjkkkkgjghghuiiiulkkjlkklplikkljhjh}
d_j:=\frac{\partial}{\partial
x^j}\left(\sum_{k=0}^{3}v_kS^k\right)
\quad\quad\forall\,j=0,1,2,3,
\end{equation}
is a four-covector. Then the following quantity
\begin{equation}\label{fgjfjhgghhgjghjhjkkkkgjghghuiiiulkkjlkklplikkljhjhlkjkkj}
\sum_{m=0}^{3}\sum_{n=0}^{3}{\Theta}^{mn}d_md_n=\sum_{m=0}^{3}\sum_{n=0}^{3}{\Theta}^{mn}\frac{\partial}{\partial
x^m}\left(\sum_{k=0}^{3}v_kS^k\right)\frac{\partial}{\partial
x^n}\left(\sum_{k=0}^{3}v_kS^k\right)
\end{equation}
is a covariant scalar, where $\{{\Theta}^{ij}\}_{0\leq i,j\leq 3}$
is the contravariant tensor of the three-dimensional geometry that
satisfies
\er{huohuioy89gjjhjffffff3478zzrrZZZhjhhjhhjjhhffGGhjjhuiiuihhjkkoioi}
in every non-inertial cartesian coordinate system:
\begin{equation}\label{huohuioy89gjjhjffffff3478zzrrZZZhjhhjhhjjhhffGGhjjhuiiuihhjkkoioikk}
\begin{cases}
{\Theta}^{00}=0
\\
{\Theta}^{0j}={\Theta}^{j0}=0\quad\quad\forall\, j=1,2,3
\\
{\Theta}^{ij}:=\delta_{ij}\quad\quad\forall\, i,j=1,2,3.
\end{cases}
\end{equation}
Moreover, by \er{hoyuiouigyfghgjh3478zzrrZZffGGjkkjojj} we have:
\begin{equation}\label{hoyuiouigyfghgjh3478zzrrZZffGGjkkjojjlkkl}
g^{ij}:=v^iv^j-{\Theta}^{ij}\quad\quad\forall\,i,j=0,1,2,3,
\end{equation}
On the other hand, inserting \er{ojkkjjkjklkj} and
\er{huohuioy89gjjhjffffff3478zzrrZZZhjhhjhhjjhhffGGhjjhuiiuihhjkkoioikk}
into \er{fgjfjhgghhgjghjhjkkkkgjghghuiiiulkkjlkklplikkljhjhlkjkkj}
we deduce that in the cartesian coordinate system we have:
\begin{multline}\label{fgjfjhgghhgjghjhjkkkkgjghghuiiiulkkjlkklplikkljhjhlkjkkjiooi}
\sum_{m=0}^{3}\sum_{n=0}^{3}{\Theta}^{mn}\frac{\partial}{\partial
x^m}\left(\sum_{j=0}^{3}\sum_{k=0}^{3}g^{kj}v_kS_j\right)\frac{\partial}{\partial
x^n}\left(\sum_{j=0}^{3}\sum_{k=0}^{3}g^{kj}v_kS_j\right)\\=\sum_{m=0}^{3}\sum_{n=0}^{3}{\Theta}^{mn}\frac{\partial}{\partial
x^m}\left(\sum_{k=0}^{3}v_kS^k\right)\frac{\partial}{\partial
x^n}\left(\sum_{k=0}^{3}v_kS^k\right)=\left|\nabla_{\vec
x}\phi\right|^2
\end{multline}
Therefore, by
\er{hoyuiouigyfghgjh3478zzrrZZjhjhugyuyughggjhjhjyuuygtyffjjjkjljkjjljokjljjkjkopopkklklkjjkjkjjkkokkjoipio;l;l;ll;lkklkjjkkklkkkkk}
and
\er{fgjfjhgghhgjghjhjkkkkgjghghuiiiulkkjlkklplikkljhjhlkjkkjiooi} we
deduce that
\begin{multline}\label{hoyuiouigyfghgjh3478zzrrZZjhjhugyuyughggjhjhjyuuygtyffjjjkjljkjjljokjljjkjkopopkklklkjjkjkjjkkokkjoipio;l;l;ll;lkklkjjkkklkkkkkjjkjk}
\sum_{m=0}^{3}\sum_{n=0}^{3}\frac{32\pi
G}{c^4}\,{\Theta}^{mn}\frac{\partial}{\partial
x^m}\left(\sum_{j=0}^{3}\sum_{k=0}^{3}g^{kj}v_kS_j\right)\frac{\partial}{\partial
x^n}\left(\sum_{j=0}^{3}\sum_{k=0}^{3}g^{kj}v_kS_j\right)\\+\left(\sum_{i=0}^{3}\sum_{j=0}^{3}g^{ij}\left(\delta_j
v_i+\delta_i
v_j\right)\right)\left(\sum_{i=0}^{3}\sum_{j=0}^{3}g^{ij}\left(\delta_j
S_i+\delta_i
S_j\right)\right)\\-\sum_{k=0}^{3}\sum_{j=0}^{3}\sum_{m=0}^{3}\sum_{n=0}^{3}\left(\delta_j
S_k+\delta_k S_j\right)g^{km}g^{jn}\left(\delta_m v_n+\delta_n
v_m\right)
\\= \frac{4}{c}\left(\left(div_{\vec x}\vec h\right)\left(div_{\vec x}\vec
v\right)-\frac{1}{4}\left(d_{\vec x}\vec v+\left\{d_{\vec x}\vec
v\right\}^T\right):\left(d_{\vec x}\vec h+\left\{d_{\vec x}\vec
h\right\}^T\right)\right)\\+\frac{4}{c^2}\left(\left(div_{\vec
x}\vec v\right)\left(\frac{\partial\phi}{\partial t}+\vec
v\cdot\nabla_{\vec x}\phi\right)+\phi\left(\left(div_{\vec x}\vec
v\right)^2-\frac{1}{4}\left|d_{\vec x}\vec v+\left\{d_{\vec x}\vec
v\right\}^T\right|^2\right)+\frac{8\pi G}{c^2}\,\left|\nabla_{\vec
x}\phi\right|^2\right),
\end{multline}
where the left hand side of
\er{hoyuiouigyfghgjh3478zzrrZZjhjhugyuyughggjhjhjyuuygtyffjjjkjljkjjljokjljjkjkopopkklklkjjkjkjjkkokkjoipio;l;l;ll;lkklkjjkkklkkkkkjjkjk}
is written in a covariant form. Therefore, for the choice
\begin{equation}\label{vhfffngghkjgghPPNggjgjjkgjoiuighhgghoiuiuouijiihhjioiuojkjk}
\phi=\frac{c^2}{16\pi G}\Phi,\quad\text{and}\quad\vec
h=-\frac{c}{2}\,\vec p:=-\frac{c}{2}(p_1,p_2,p_3)\,.
\end{equation}
we can rewrite
\er{hoyuiouigyfghgjh3478zzrrZZjhjhugyuyughggjhjhjyuuygtyffjjjkjljkjjljokjljjkjkopopkklklkjjkjkjjkkokkjoipio;l;l;ll;lkklkjjkkklkkkkkjjkjk}
in cartesian coordinate system as:
\begin{multline}\label{hoyuiouigyfghgjh3478zzrrZZjhjhugyuyughggjhjhjyuuygtyffjjjkjljkjjljokjljjkjkopopkklklkjjkjkjjkkokkjoipio;l;l;ll;lkklkjjkkklkkkkkjjkjkou}
\frac{1}{2}\left(d_{\vec x}\vec v+\left\{d_{\vec x}\vec
v\right\}^T\right):\left(d_{\vec x}\vec p+\left\{d_{\vec x}\vec
p\right\}^T\right)-2\left(div_{\vec x}\vec v\right)\left(div_{\vec
x}\vec p\right)\\+\frac{1}{4\pi G}\left(div_{\vec x}\vec
v\right)\left(\frac{\partial \Phi}{\partial t}+\vec
v\cdot\nabla_{\vec x} \Phi\right)+\frac{1}{4\pi
G}\Phi\left(div_{\vec x}\vec v\right)^2-\frac{\Phi}{16\pi
G}\left|d_{\vec x}\vec v+\left\{d_{\vec x}\vec
v\right\}^T\right|^2+\frac{1}{8\pi G}\left|\nabla_{\vec
x}\Phi\right|^2=\\
\sum_{m=0}^{3}\sum_{n=0}^{3}\frac{32\pi
G}{c^4}\,{\Theta}^{mn}\frac{\partial}{\partial
x^m}\left(\sum_{j=0}^{3}\sum_{k=0}^{3}g^{kj}v_kS_j\right)\frac{\partial}{\partial
x^n}\left(\sum_{j=0}^{3}\sum_{k=0}^{3}g^{kj}v_kS_j\right)\\+\left(\sum_{i=0}^{3}\sum_{j=0}^{3}g^{ij}\left(\delta_j
v_i+\delta_i
v_j\right)\right)\left(\sum_{i=0}^{3}\sum_{j=0}^{3}g^{ij}\left(\delta_j
S_i+\delta_i
S_j\right)\right)\\-\sum_{k=0}^{3}\sum_{j=0}^{3}\sum_{m=0}^{3}\sum_{n=0}^{3}\left(\delta_j
S_k+\delta_k S_j\right)g^{km}g^{jn}\left(\delta_m v_n+\delta_n
v_m\right),
\end{multline}
where the right hand side is written in the covariant form which is
valid in every curvilinear coordinate system and, due to
\er{hoyuiouigyfghgjh3478zzrrZZjhjhhjkjjhnmnmffjj},
\er{hoyuiouigyfghgjh3478zzrrZZjhjhhjkjjhnmnmffjjkkk} and
\er{vhfffngghkjgghPPNggjgjjkgjoiuighhgghoiuiuouijiihhjioiuojkjk} in
a cartesian coordinate system we have:
\begin{equation}\label{hoyuiouigyfghgjh3478zzrrZZjhjhhjkjjhnmnmffjjhjkjkj}
\begin{cases}
S_0=\frac{c^2}{16\pi G}\Phi-\frac{1}{2}\vec v\cdot\vec p\\
S_j=\frac{c}{2}p_j\quad\quad\forall 1\leq j\leq 3.
\end{cases}
\end{equation}
and
\begin{equation}\label{hoyuiouigyfghgjh3478zzrrZZjhjhhjkjjhnmnmffjjkkkjgjojkjk}
\begin{cases}
S^0=\frac{c^2}{16\pi G}\Phi\\
S^j=\frac{c^2}{16\pi G}\Phi
\frac{v^j}{c}-\frac{c}{2}p_j\quad\quad\forall 1\leq j\leq 3.
\end{cases}
\end{equation}

Next, given a system of $n$ particles with inertial masses
$m_1,\ldots,m_n$, charges $\sigma_1,\ldots,\sigma_n$, places $\vec
r_1(t),\ldots,\vec r_n(t)$ and velocities $\frac{d\vec
r_1}{dt}(t),\ldots, \frac{d\vec r_n}{dt}(t)$ in the cartesian
coordinate system, the usual definitions of the charge density,
current density and the mass density of this system are the
following:
\begin{equation}\label{btfffjhgjghghijhhjhhhjjkjkkj}
\begin{cases}
\rho(\vec
x,t):=\sum\limits_{j=1}^{n}\sigma_j\,\delta^{(3)}\left(\vec x-\vec
r_j(t)\right),\\
\vec j(\vec x,t):=\sum\limits_{j=1}^{n}\sigma_j\frac{d\vec
r_j}{dt}(t)\,\delta^{(3)}\left(\vec x-\vec r_j(t)\right),\\
\mu(\vec x,t):=\sum\limits_{j=1}^{n}m_j\,\delta^{(3)}\left(\vec
x-\vec r_j(t)\right),
\end{cases}
\end{equation}
where $\delta^{(3)}$ is the usual Dirac-delta distribution
(generalized function) in $\mathbb{\R}^3$. Then denoting by
$\delta^{(4)}$  the Dirac-delta distribution in $\mathbb{\R}^4$ and
by $\delta^{(1)}$  the Dirac-delta distribution in $\mathbb{\R}$,
since we have
\begin{multline}\label{btfffjhgjghghijhhjhhhjjkjkkjkkkikkk}
\delta^{(4)}\left(\left(x^0,x^1,x^2,x^3\right)-\left(a^0,a^1,a^2,a^3\right)\right)=\delta^{(3)}\left(\vec
x-\vec a\right)\,\delta^{(1)}\left(x^0-a^0\right)\\
\quad\quad\forall\,\left(x^0,x^1,x^2,x^3\right)=(x^0,\vec
x)\in\mathbb{\R}^4,
\;\;\forall\,\left(a^0,a^1,a^2,a^3\right)=(a^0,\vec
a)\in\mathbb{\R}^4,\\
\text{and}\quad\delta^{(3)}\left(\vec x-\vec
a\right)=\frac{1}{c^3}\delta^{(3)}\left(\frac{1}{c}\left(\vec x-\vec
a\right)\right),
\end{multline}
we rewrite \er{btfffjhgjghghijhhjhhhjjkjkkj} as
\begin{equation}\label{btfffjhgjghghijhhjhhhjjkjkkjjhhh}
\begin{cases}
\rho(\vec
x,t)=\frac{1}{c^3}\int_{\mathbb{R}}\left(\sum\limits_{j=1}^{n}\sigma_j\,\delta^{(4)}\left(\left(x^0,x^1,x^2,x^3\right)
-\left(\chi^0_j(t),\chi^1_j(t),\chi^2_j(t),\chi^3_j(t)\right)\right)\right)dx^0,\\
\vec j(\vec x,t)=
\frac{1}{c^3}\int_{\mathbb{R}}\left(\sum\limits_{j=1}^{n}\sigma_j\frac{d\vec
r_j}{dx^0}(x^0)\,\delta^{(4)}\left(\left(x^0,x^1,x^2,x^3\right)
-\left(\chi^0_j(t),\chi^1_j(t),\chi^2_j(t),\chi^3_j(t)\right)\right)\right)dx^0\\
\mu(\vec
x,t)=\frac{1}{c^3}\int_{\mathbb{R}}\left(\sum\limits_{j=1}^{n}m_j\,\delta^{(4)}\left(\left(x^0,x^1,x^2,x^3\right)
-\left(\chi^0_j(t),\chi^1_j(t),\chi^2_j(t),\chi^3_j(t)\right)\right)\right)dx^0,
\end{cases}
\end{equation}
where we denote
\begin{equation}\label{btfffjhgjghghijhhjhhhjjkjkkhhj}
\left(x^0,x^1,x^2,x^3\right):=\left(t,\frac{1}{c}\vec x\right).
\end{equation}
and
$\left(\chi^0_j(t),\chi^1_j(t),\chi^2_j(t),\chi^3_j(t)\right)\in\mathbb{R}^4$
is a four-dimensional space-time trajectory of the $j$-th particle,
parameterized by the global time, which is defined by the following:
\begin{equation}\label{btfffjhgjghghijhhjhhhjjkjkk}
\left(\chi^0_j(t),\chi^1_j(t),\chi^2_j(t),\chi^3_j(t)\right):=\left(t,\frac{1}{c}\vec
r_j(t)\right).
\end{equation}
Thus, if we denote by $G$ the $4\times 4$-matrix
$G:=\{g_{ij}\}_{0\leq i,j\leq 3}$. then, since the matrix $G$
satisfies $\text{det}\,G=-1$ in every cartesian coordinate system,
we can rewrite the first two equations in
\er{btfffjhgjghghijhhjhhhjjkjkkjjhhh} as:
\begin{multline}\label{btfffjhgjghghijhhjhhhjjkjkkjjhhhjkjjkjhj}
\left(j^0,j^1,j^2,j^3\right):=\left(\rho, \frac{1}{c}\vec j\right)(\vec x,t)=\\
\int_{\mathbb{R}}\left(\sum\limits_{j=1}^{n}\frac{\sigma_j\left(\frac{d\chi^0_j}{dx^0},\frac{d\chi^1_j}{dx^0},\frac{d\chi^2_j}{dx^0},
\frac{d\chi^0_j}{dx^0}\right)(x^0)}{c^3\sqrt{\left|\text{det}\,G\left(x^0,x^1,x^2,x^3\right)\right|}}
\,\delta^{(4)}\left(\left(x^0,x^1,x^2,x^3\right)
-\left(\chi^0_j(t),\chi^1_j(t),\chi^2_j(t),\chi^3_j(t)\right)\right)\right)dx^0.
\end{multline}
Note here that we denoted the matrix $G=\{g_{ij}\}_{0\leq i,j\leq
3}$ by the same letter as the Gravitational Constant $G$. However,
there is no ambiguity, since in the second case $G$ is a constant
scalar and in the first case $G$ is a matrix. Moreover, we will use
the matrix notation $G=\{g_{ij}\}_{0\leq i,j\leq 3}$ only in the
expressions containing term $\text{det}\,G$.

Similarly, we can write:
\begin{multline}\label{btfffjhgjghghijhhjhhhjjkjkkjjhhhjkjjkjhjjkkj}
\mu(\vec x,t)\sqrt{1-\frac{1}{c^2}\left|\vec u(\vec x,t)-\vec v(\vec
\vec x,t)\right|^2}=\\
\int\limits_{\mathbb{R}}\Bigg(\sum\limits_{j=1}^{n}\frac{m_j
\sqrt{\left(\sum\limits_{m=0}^{3}\sum\limits_{k=0}^{3}g_{mk}\left(\left(\chi^0_j,\ldots\chi^3_j\right)\right)\,\frac{d\chi^m_j}{dx^0}\,\frac{d\chi^k_j}{dx^0}\right)(x^0)}
}{c^3\sqrt{\left|\text{det}\,G\left(x^0,\ldots x^3\right)\right|}}
\delta^{(4)}\left(\left(x^0,\ldots x^3\right)
-\left(\chi^0_j,\ldots\chi^3_j\right)(t)\right)\Bigg)dx^0,
\end{multline}
where $\vec u(\vec x,t)$ is the field of velocities of the given
system of particles, considered as a continuum. Next we observe,
that the four-dimensinal quantity in the right hand side of
\er{btfffjhgjghghijhhjhhhjjkjkkjjhhhjkjjkjhj}:
\begin{multline}\label{btfffjhgjghghijhhjhhhjjkjkkjjhhhjkjjkjhjjhhj}
\int_{\mathbb{R}}\left(\sum\limits_{j=1}^{n}\frac{\sigma_j\left(\frac{d\chi^0_j}{dx^0},\frac{d\chi^1_j}{dx^0},\frac{d\chi^2_j}{dx^0},
\frac{d\chi^0_j}{dx^0}\right)(x^0)}{c^3\sqrt{\left|\text{det}\,G\left(x^0,x^1,x^2,x^3\right)\right|}}
\,\delta^{(4)}\left(\left(x^0,x^1,x^2,x^3\right)
-\left(\chi^0_j,\chi^1_j,\chi^2_j,\chi^3_j\right)\right)\right)dx^0
\end{multline}
can be defined also in curvilinear systems of coordinates, where
$\frac{\partial t}{\partial x^0}>0$, and it can be easily proved in
a similar way as it is proved in the General relativity, that this
quantity, defined in every curvilinear coordinate system with
$\frac{\partial t}{\partial x^0}>0$, equals to some
\underline{four-vector}. Similarly, the scalar quantity in the right
hand side of \er{btfffjhgjghghijhhjhhhjjkjkkjjhhhjkjjkjhjjkkj}:
\begin{multline}\label{btfffjhgjghghijhhjhhhjjkjkkjjhhhjkjjkjhjjkkjklkkl}
\int\limits_{\mathbb{R}}\Bigg(\sum\limits_{j=1}^{n}\frac{m_j
\sqrt{\left(\sum\limits_{m=0}^{3}\sum\limits_{k=0}^{3}g_{mk}\left(\left(\chi^0_j,\ldots\chi^3_j\right)\right)\,\frac{d\chi^m_j}{dx^0}\,\frac{d\chi^k_j}{dx^0}\right)(x^0)}
}{c^3\sqrt{\left|\text{det}\,G\left(x^0,\ldots x^3\right)\right|}}
\delta^{(4)}\left(\left(x^0,\ldots x^3\right)
-\left(\chi^0_j,\ldots\chi^3_j\right)\right)\Bigg)dx^0
\end{multline}
can also be defined in curvilinear systems of coordinates and this
quantity forms a covariant \underline{scalar} under the change of
curvilinear coordinate system. On the other hand, since
\er{btfffjhgjghghijhhjhhhjjkjkkjkkkikkk} also valid in every
curvilinear coordinate systems we rewrite
\er{btfffjhgjghghijhhjhhhjjkjkkjjhhhjkjjkjhjjhhj} and
\er{btfffjhgjghghijhhjhhhjjkjkkjjhhhjkjjkjhjjkkjklkkl} as:
\begin{multline}\label{btfffjhgjghghijhhjhhhjjkjkkjjhhhjkjjkjhjjhhjjhj}
\int_{\mathbb{R}}\left(\sum\limits_{j=1}^{n}\frac{\sigma_j\left(\frac{d\chi^0_j}{dx^0},\frac{d\chi^1_j}{dx^0},\frac{d\chi^2_j}{dx^0},
\frac{d\chi^0_j}{dx^0}\right)(x^0)}{c^3\sqrt{\left|\text{det}\,G\left(x^0,x^1,x^2,x^3\right)\right|}}
\,\delta^{(4)}\left(\left(x^0,x^1,x^2,x^3\right)
-\left(\chi^0_j,\chi^1_j,\chi^2_j,\chi^3_j\right)\right)\right)dx^0\\=
\sum\limits_{j=1}^{n}\frac{\sigma_j\left(\frac{d\chi^0_j}{d\chi^0_j},\frac{d\chi^1_j}{d\chi^0_j},\frac{d\chi^2_j}{d\chi^0_j},
\frac{d\chi^0_j}{d\chi^0_j}\right)(\chi^0_j)}{c^3\sqrt{\left|\text{det}\,G\left(\chi^0_j,x^1,x^2,x^3\right)\right|}}
\,\delta^{(3)}\left(\left(x^1,x^2,x^3\right)
-\left(\chi^1_j,\chi^2_j,\chi^3_j\right)(\chi^0_j)\right)\\= \Bigg(
\sum\limits_{j=1}^{n}\frac{\sigma_j}{\sqrt{\left|\text{det}\,G\left(\chi^0_j,x^1,x^2,x^3\right)\right|}}
\,\frac{1}{c^3}\delta^{(3)}\left(\left(x^1,x^2,x^3\right)
-\left(\chi^1_j,\chi^2_j,\chi^3_j\right)(\chi^0_j)\right) ,\\
\sum\limits_{j=1}^{n}\frac{\sigma_j\left(\frac{d\chi^1_j}{dx^0},\frac{d\chi^2_j}{dx^0},
\frac{d\chi^0_j}{dx^0}\right)(\chi^0_j)}{\sqrt{\left|\text{det}\,G\left(\chi^0_j,x^1,x^2,x^3\right)\right|}}
\,\frac{1}{c^3}\delta^{(3)}\left(\left(x^1,x^2,x^3\right)
-\left(\chi^1_j,\chi^2_j,\chi^3_j\right)(\chi^0_j)\right)\Bigg)
\end{multline}
and
\begin{multline}\label{btfffjhgjghghijhhjhhhjjkjkkjjhhhjkjjkjhjjkkjklkklijhjhj}
\int\limits_{\mathbb{R}}\Bigg(\sum\limits_{j=1}^{n}\frac{m_j
\sqrt{\left(\sum\limits_{m=0}^{3}\sum\limits_{k=0}^{3}g_{mk}\left(\left(\chi^0_j,\ldots\chi^3_j\right)\right)\,\frac{d\chi^m_j}{dx^0}\,\frac{d\chi^k_j}{dx^0}\right)(x^0)}
}{c^3\sqrt{\left|\text{det}\,G\left(x^0,\ldots x^3\right)\right|}}
\delta^{(4)}\left(\left(x^0,\ldots x^3\right)
-\left(\chi^0_j,\ldots\chi^3_j\right)\right)\Bigg)dx^0\\=
\sum\limits_{j=1}^{n}\frac{m_j
\sqrt{\left(\sum\limits_{m=0}^{3}\sum\limits_{k=0}^{3}g_{mk}\left(\left(\chi^0_j,\ldots\chi^3_j\right)\right)\,\frac{d\chi^m_j}{d\chi^0_j}\,\frac{d\chi^k_j}{d\chi^0_j}\right)
}
}{c^3\sqrt{\left|\text{det}\,G\left(\chi^0_j,x^1,\ldots
x^3\right)\right|}} \delta^{(3)}\left(\left(x^1,\ldots x^3\right)
-\left(\chi^1_j,\ldots\chi^3_j\right)(\chi^0_j)\right)
\end{multline}
So, by \er{btfffjhgjghghijhhjhhhjjkjkkjjhhhjkjjkjhj} and
\er{btfffjhgjghghijhhjhhhjjkjkkjjhhhjkjjkjhjjhhjjhj} in every
curvilinear coordinate system, where $\frac{\partial t}{\partial
x^0}>0$, we have:
\begin{multline}\label{btfffjhgjghghijhhjhhhjjkjkkjjhhhjkjjkjhjkllkj}
\left(j^0,j^1,j^2,j^3\right):=\frac{1}{\sqrt{\left|\text{det}\,G\right|}}\left(\hat\rho, \frac{1}{c}\hat\rho\,\vec {\hat u}_{x^0}\right)\quad\text{where}\\
\hat\rho:=\sum\limits_{j=1}^{n}\sigma_j \,\delta^{(3)}\left(\vec
{\hat x}
-c\left(\chi^1_j,\chi^2_j,\chi^3_j\right)(\chi^0_j)\right)\\
\quad\text{is the local charge density, calculated in the
curvilinear coordinate system},
\end{multline}
$\vec {\hat x}:=(cx^1,cx^2,cx^3)$ and $\vec{\hat u}_{x^0}$ is the
field of velocities of the system, calculated in a given curvilinear
coordinate system by the differentiation of the last three
coordinates of the particle: $\vec {\hat
r}:=(c\chi^1,c\chi^2,c\chi^3)$ by the coordinate $\chi^0$ that can
be considered as the local time instead of global time $t$. The
quantity in \er{btfffjhgjghghijhhjhhhjjkjkkjjhhhjkjjkjhjkllkj}
equals to a four-vector, under the change of curvilinear coordinate
systems. Similarly by
\er{btfffjhgjghghijhhjhhhjjkjkkjjhhhjkjjkjhjjkkjklkklijhjhj} the
following quantities, defined in every coordinate system where
$\frac{\partial t}{\partial x^0}>0$, equal to some four-vector and
some covariant scalar respectively, under the change of curvilinear
coordinate systems:
\begin{multline}\label{btfffjhgjghghijhhjhhhjjkjkkjjhhhjkjjkjhjjkkjklkklijhjhjkmhggm}
\frac{1}{\sqrt{\left|\text{det}\,G\right|}}\left(\hat\mu,
\frac{1}{c}\hat\mu\,\vec {\hat u}_{x^0}\right)\quad\text{and}\quad
\frac{\hat \mu}{\sqrt{\left|\text{det}\,G\right|}}
\sqrt{\left(\sum\limits_{m=0}^{3}\sum\limits_{k=0}^{3}g_{mk}\,\hat
u^m_{x^0}\,\hat u^k_{x^0}\right)}\quad\quad\quad
\quad\text{where}\\
\hat\mu:=\sum\limits_{j=1}^{n}m_j \,\delta^{(3)}\left(\vec {\hat x}
-c\left(\chi^1_j,\chi^2_j,\chi^3_j\right)(\chi^0_j)\right)\\
\quad\text{is the local mass density, calculated in the curvilinear
coordinate system},
\end{multline}
and $\left(\hat u^0,\hat u^1,\hat u^2,\hat
u^3\right)_{x^0}=\left(1,\frac{1}{c}\vec{\hat u}_{x^0}\right)$ is
the field of four dimensional velocities of the system, calculated
in a given curvilinear coordinate system by the differentiation of
the four dimensional coordinates of the particles by the first
coordinate $\chi^0$. Again note that although the quantities
$\hat\rho$ and $\hat\mu$ are not covariant scalars and $\left(\hat
u^0,\hat u^1,\hat u^2,\hat u^3\right)_{x^0}$ is not a four-vector,
the first quantity in
\er{btfffjhgjghghijhhjhhhjjkjkkjjhhhjkjjkjhjkllkj} equals to a
four-vector and the two first quantities in
\er{btfffjhgjghghijhhjhhhjjkjkkjjhhhjkjjkjhjjkkjklkklijhjhjkmhggm}
equal to a four-vector and a covariant scalar, under the change of
curvilinear coordinate systems. Moreover, clearly the four
dimensional speed $\left(u^0,u^1,u^2,u^3\right)_{t}$, obtained in
curvilinear coordinate system by the differentiation by the global
time $t$, instead of the first local coordinate $\chi_0$, indeed
forms a four-vector and therefore, the quantity, defined in every
coordinate system where $\frac{\partial t}{\partial x^0}>0$,
\begin{equation}\label{btfffjhgjghghijhhjhhhjjkjkkjjhhhjkjjkjhjjkkjklkklijhjhjkmhggmhjhjhkkk}
\frac{\hat \mu}{\sqrt{\left|\text{det}\,G\right|}}
\left(\sum\limits_{m=0}^{3}\sum\limits_{k=0}^{3}g_{mk}\,\hat
u^m_{x^0}\,u^k_{t}\right)\quad\text{where}\;\;
\hat\mu=\sum\limits_{j=1}^{n}m_j \,\delta^{(3)}\left(\vec {\hat x}
-c\left(\chi^1_j,\chi^2_j,\chi^3_j\right)(\chi^0_j)\right)
\end{equation}
equals to a covariant scalar, under the change of curvilinear
coordinate systems.

Next we can write the density of the Lagrangian of the
electromagnetic field, defined in \er{vhfffngghkjgghPPNggjgjjkgj} in
the equivalent form \er{vhfffngghkjgghPPNggjgjjkgjoiuioiggh}, where
the right hand side is written in a covariant form which is valid
for every curvilinear coordinate system:
\begin{multline}\label{vhfffngghkjgghPPNggjgjjkgjoiuioigghghghfguui}
\frac{1}{4\pi}\left(\frac{1}{2}\left|\vec
D\right|^2-\frac{1}{2}\left|\vec
B\right|^2-4\pi\left(\rho\Psi-\frac{1}{c}\vec A\cdot\vec
j\right)\right)=\\
\frac{1}{4\pi}\left(-\sum_{n=0}^{3}\sum_{k=0}^{3}\sum_{m=0}^{3}\sum_{p=0}^{3}\frac{1}{4}g^{mn}g^{pk}\left(\frac{\partial
A_p}{\partial x^m}-\frac{\partial A_m}{\partial
x^p}\right)\left(\frac{\partial A_k}{\partial x^n}-\frac{\partial
A_n}{\partial x^k}\right)-\sum_{k=0}^{3}4\pi j^k A_k\right),
\end{multline}
where $(j^0,j^1,j^2,j^3)$ is the four-vector of the current that
satisfies \er{btfffjhgjghghijhhjhhhjjkjkkjjhhhjkjjkjhjkllkj} in
every curvilinear coordinate system, where $\frac{\partial
t}{\partial x^0}>0$, and $(A_0,A_1,A_2,A_3)$ is the four-covector of
the electromagnetic potential. Then by
\er{vhfffngghkjgghPPNggjgjjkgjoiuioigghghghfguui},
\er{hoyuiouigyfghgjh3478zzrrZZjhjhugyuyughggjhjhjyuuygtyffjjjkjljkjjljokjljjkjkopopkklklkjjkjkjjkkokkjoipio;l;l;ll;lkklkjjkkklkkkkkjjkjkou}
and
\er{btfffjhgjghghijhhjhhhjjkjkkjjhhhjkjjkjhjjkkjklkklijhjhjkmhggmhjhjhkkk}
we rewrite the Lagrangian density in \er{vhfffngghkjgghDDmmioiuiu}
in the cartesian coordinate system as:
\begin{multline}\label{vhfffngghkjgghDDmmioiuiujhhjhj}
\frac{1}{8\pi}\left|-\nabla_{\vec
x}\Psi-\frac{1}{c}\frac{\partial\vec A}{\partial t}+\frac{1}{c}\vec
v\times curl_{\vec x}\vec A\right|^2-\frac{1}{8\pi}\left|curl_{\vec
x}\vec A\right|^2-\left(\rho\Psi-\frac{1}{c}\vec A\cdot\vec
j\right)\\+\mu g\left(\frac{1}{2}\left|\vec u-\vec
v\right|^2\right)+
\frac{1}{2}\left(d_{\vec x}\vec v+\left\{d_{\vec x}\vec
v\right\}^T\right):\left(d_{\vec x}\vec p+\left\{d_{\vec x}\vec
p\right\}^T\right)-2\left(div_{\vec x}\vec v\right)\left(div_{\vec
x}\vec p\right)\\+\frac{1}{4\pi G}\left(div_{\vec x}\vec
v\right)\left(\frac{\partial \Phi}{\partial t}+\vec
v\cdot\nabla_{\vec x} \Phi\right)+\frac{1}{4\pi
G}\Phi\left(div_{\vec x}\vec v\right)^2-\frac{\Phi}{16\pi
G}\left|d_{\vec x}\vec v+\left\{d_{\vec x}\vec
v\right\}^T\right|^2+\frac{1}{8\pi G}\left|\nabla_{\vec
x}\Phi\right|^2=L_{ge}=
\\
\frac{1}{4\pi}\left(-\sum_{n=0}^{3}\sum_{k=0}^{3}\sum_{m=0}^{3}\sum_{p=0}^{3}\frac{1}{4}g^{mn}g^{pk}\left(\frac{\partial
A_p}{\partial x^m}-\frac{\partial A_m}{\partial
x^p}\right)\left(\frac{\partial A_k}{\partial x^n}-\frac{\partial
A_n}{\partial x^k}\right)-\sum_{k=0}^{3}4\pi j^k A_k\right)\\
+\frac{\hat \mu}{\sqrt{\left|\text{det}\,G\right|}}
\left(\sum\limits_{m=0}^{3}\sum\limits_{k=0}^{3}g_{mk}\,\hat
u^m_{x^0}\,u^k_{t}\right)\left(\sum\limits_{m=0}^{3}\sum\limits_{k=0}^{3}g_{mk}\,
u^m_{t}\,u^k_{t}\right)^{-1}g\left(\frac{c^2}{2}-\sum\limits_{m=0}^{3}\sum\limits_{k=0}^{3}\frac{c^2}{2}g_{mk}\,
u^m_{t}\,u^k_{t}\right)
\\
+\sum_{m=0}^{3}\sum_{n=0}^{3}\frac{32\pi
G}{c^4}\,{\Theta}^{mn}\frac{\partial}{\partial
x^m}\left(\sum_{j=0}^{3}\sum_{k=0}^{3}g^{kj}v_kS_j\right)\frac{\partial}{\partial
x^n}\left(\sum_{j=0}^{3}\sum_{k=0}^{3}g^{kj}v_kS_j\right)\\+\left(\sum_{i=0}^{3}\sum_{j=0}^{3}g^{ij}\left(\delta_j
v_i+\delta_i
v_j\right)\right)\left(\sum_{i=0}^{3}\sum_{j=0}^{3}g^{ij}\left(\delta_j
S_i+\delta_i
S_j\right)\right)\\-\sum_{k=0}^{3}\sum_{j=0}^{3}\sum_{m=0}^{3}\sum_{n=0}^{3}\left(\delta_j
S_k+\delta_k S_j\right)g^{km}g^{jn}\left(\delta_m v_n+\delta_n
v_m\right),
\end{multline}
where the right hand side is written in the covariant form and the
second equality is valid in every curvilinear coordinate system with
$\frac{\partial t}{\partial x^0}>0$, $(j^0,j^1,j^2,j^3)$ is the
four-vector of the current, that satisfies
\er{btfffjhgjghghijhhjhhhjjkjkkjjhhhjkjjkjhjkllkj}, $\hat\mu$ is
given by
\er{btfffjhgjghghijhhjhhhjjkjkkjjhhhjkjjkjhjjkkjklkklijhjhjkmhggmhjhjhkkk}
and in a cartesian coordinate system we have:
\begin{equation}\label{hoyuiouigyfghgjh3478zzrrZZjhjhhjkjjhnmnmffjjhjkjkjjhhj}
\begin{cases}
S_0=\frac{c^2}{16\pi G}\Phi-\frac{1}{2}\vec v\cdot\vec p\\
S_j=\frac{c}{2}p_j\quad\quad\forall 1\leq j\leq 3.
\end{cases}
\end{equation}
and
\begin{equation}\label{hoyuiouigyfghgjh3478zzrrZZjhjhhjkjjhnmnmffjjkkkjgjojkjkkjjh}
\begin{cases}
S^0=\frac{c^2}{16\pi G}\Phi\\
S^j=\frac{c^2}{16\pi G}\Phi
\frac{v^j}{c}-\frac{c}{2}p_j\quad\quad\forall 1\leq j\leq 3.
\end{cases}
\end{equation}
Moreover in the particular case of the relativistic-like choice
$g(s):=-c^2\sqrt{1-\frac{2s}{c^2}}$, by
\er{btfffjhgjghghijhhjhhhjjkjkkjjhhhjkjjkjhjjkkjklkklijhjhjkmhggm}
we can write an alternative to \er{vhfffngghkjgghDDmmioiuiujhhjhj}
as:
\begin{multline}\label{vhfffngghkjgghDDmmioiuiujhhjhjjkjk}
\frac{1}{8\pi}\left|-\nabla_{\vec
x}\Psi-\frac{1}{c}\frac{\partial\vec A}{\partial t}+\frac{1}{c}\vec
v\times curl_{\vec x}\vec A\right|^2-\frac{1}{8\pi}\left|curl_{\vec
x}\vec A\right|^2-\left(\rho\Psi-\frac{1}{c}\vec A\cdot\vec
j\right)\\+\mu g\left(\frac{1}{2}\left|\vec u-\vec
v\right|^2\right)+
\frac{1}{2}\left(d_{\vec x}\vec v+\left\{d_{\vec x}\vec
v\right\}^T\right):\left(d_{\vec x}\vec p+\left\{d_{\vec x}\vec
p\right\}^T\right)-2\left(div_{\vec x}\vec v\right)\left(div_{\vec
x}\vec p\right)\\+\frac{1}{4\pi G}\left(div_{\vec x}\vec
v\right)\left(\frac{\partial \Phi}{\partial t}+\vec
v\cdot\nabla_{\vec x} \Phi\right)+\frac{1}{4\pi
G}\Phi\left(div_{\vec x}\vec v\right)^2-\frac{\Phi}{16\pi
G}\left|d_{\vec x}\vec v+\left\{d_{\vec x}\vec
v\right\}^T\right|^2+\frac{1}{8\pi G}\left|\nabla_{\vec
x}\Phi\right|^2=L_{ge}=\\
\frac{1}{4\pi}\left(-\sum_{n=0}^{3}\sum_{k=0}^{3}\sum_{m=0}^{3}\sum_{p=0}^{3}\frac{1}{4}g^{mn}g^{pk}\left(\frac{\partial
A_p}{\partial x^m}-\frac{\partial A_m}{\partial
x^p}\right)\left(\frac{\partial A_k}{\partial x^n}-\frac{\partial
A_n}{\partial x^k}\right)-\sum_{k=0}^{3}4\pi j^k A_k\right)\\
-\frac{c^2\hat \mu}{\sqrt{\left|\text{det}\,G\right|}}
\sqrt{\left(\sum\limits_{m=0}^{3}\sum\limits_{k=0}^{3}g_{mk}\,\hat
u^m_{x^0}\,\hat u^k_{x^0}\right)}
\\+\sum_{m=0}^{3}\sum_{n=0}^{3}\frac{32\pi
G}{c^4}\,{\Theta}^{mn}\frac{\partial}{\partial
x^m}\left(\sum_{j=0}^{3}\sum_{k=0}^{3}g^{kj}v_kS_j\right)\frac{\partial}{\partial
x^n}\left(\sum_{j=0}^{3}\sum_{k=0}^{3}g^{kj}v_kS_j\right)\\+\left(\sum_{i=0}^{3}\sum_{j=0}^{3}g^{ij}\left(\delta_j
v_i+\delta_i
v_j\right)\right)\left(\sum_{i=0}^{3}\sum_{j=0}^{3}g^{ij}\left(\delta_j
S_i+\delta_i
S_j\right)\right)\\-\sum_{k=0}^{3}\sum_{j=0}^{3}\sum_{m=0}^{3}\sum_{n=0}^{3}\left(\delta_j
S_k+\delta_k S_j\right)g^{km}g^{jn}\left(\delta_m v_n+\delta_n
v_m\right),
\end{multline}
where the right hand side is written in the covariant form and the
second equality is valid in every curvilinear coordinate system with
$\frac{\partial t}{\partial x^0}>0$. On the other hand, in the case
of fully non-relativistic Lagrangian, where
$g(s)=\left(s-\frac{c^2}{2}\right)$, we can write an alternative to
\er{vhfffngghkjgghDDmmioiuiujhhjhj} as:
\begin{multline}\label{vhfffngghkjgghDDmmioiuiujhhjhjklkkl}
\frac{1}{8\pi}\left|-\nabla_{\vec
x}\Psi-\frac{1}{c}\frac{\partial\vec A}{\partial t}+\frac{1}{c}\vec
v\times curl_{\vec x}\vec A\right|^2-\frac{1}{8\pi}\left|curl_{\vec
x}\vec A\right|^2-\left(\rho\Psi-\frac{1}{c}\vec A\cdot\vec
j\right)\\+\mu g\left(\frac{1}{2}\left|\vec u-\vec
v\right|^2\right)+
\frac{1}{2}\left(d_{\vec x}\vec v+\left\{d_{\vec x}\vec
v\right\}^T\right):\left(d_{\vec x}\vec p+\left\{d_{\vec x}\vec
p\right\}^T\right)-2\left(div_{\vec x}\vec v\right)\left(div_{\vec
x}\vec p\right)\\+\frac{1}{4\pi G}\left(div_{\vec x}\vec
v\right)\left(\frac{\partial \Phi}{\partial t}+\vec
v\cdot\nabla_{\vec x} \Phi\right)+\frac{1}{4\pi
G}\Phi\left(div_{\vec x}\vec v\right)^2-\frac{\Phi}{16\pi
G}\left|d_{\vec x}\vec v+\left\{d_{\vec x}\vec
v\right\}^T\right|^2+\frac{1}{8\pi G}\left|\nabla_{\vec
x}\Phi\right|^2=L_{ge}=\\
\frac{1}{4\pi}\left(-\sum_{n=0}^{3}\sum_{k=0}^{3}\sum_{m=0}^{3}\sum_{p=0}^{3}\frac{1}{4}g^{mn}g^{pk}\left(\frac{\partial
A_p}{\partial x^m}-\frac{\partial A_m}{\partial
x^p}\right)\left(\frac{\partial A_k}{\partial x^n}-\frac{\partial
A_n}{\partial x^k}\right)-\sum_{k=0}^{3}4\pi j^k A_k\right)\\
-\frac{c^2\hat \mu}{\sqrt{\left|\text{det}\,G\right|}}
\left(\sum\limits_{m=0}^{3}\sum\limits_{k=0}^{3}g_{mk}\,\hat
u^m_{x^0}\,u^k_{t}\right)
\\+\sum_{m=0}^{3}\sum_{n=0}^{3}\frac{32\pi
G}{c^4}\,{\Theta}^{mn}\frac{\partial}{\partial
x^m}\left(\sum_{j=0}^{3}\sum_{k=0}^{3}g^{kj}v_kS_j\right)\frac{\partial}{\partial
x^n}\left(\sum_{j=0}^{3}\sum_{k=0}^{3}g^{kj}v_kS_j\right)\\+\left(\sum_{i=0}^{3}\sum_{j=0}^{3}g^{ij}\left(\delta_j
v_i+\delta_i
v_j\right)\right)\left(\sum_{i=0}^{3}\sum_{j=0}^{3}g^{ij}\left(\delta_j
S_i+\delta_i
S_j\right)\right)\\-\sum_{k=0}^{3}\sum_{j=0}^{3}\sum_{m=0}^{3}\sum_{n=0}^{3}\left(\delta_j
S_k+\delta_k S_j\right)g^{km}g^{jn}\left(\delta_m v_n+\delta_n
v_m\right),
\end{multline}
where again the right hand side is written in the covariant form and
the second equality is valid in every curvilinear coordinate system
with $\frac{\partial t}{\partial x^0}>0$.

Once we wrote the Lagrangian density
$L_{ge}:=L\left(S^k,v^k,A^k,x^k\right)_{k=0,\ldots 4}$ as a
covariant scalar, under the changes of curvilinear coordinate
systems, we can write a covariant Lagrangian as:
\begin{equation}\label{uuuuuuuu}
J_{ge}(S^k,v^k,A^k):=\int_{(x^0,x^1,x^2,x^3)}L_{ge}\left(S^k,v^k,A^k,x^k\right)\,\sqrt{\left|\text{det}\,G\right|}\,dx^0dx^1dx^2dx^3.
\end{equation}
Although we need a term $\sqrt{\left|\text{det}\,G\right|}$ for the
covariance of the Lagrangian in curvilinear coordinate systems, in
the cartesian coordinate systems we always have
$\sqrt{\left|\text{det}\,G\right|}=1$.

Next note that the contravariant tensor of the three-dimensional
geometry $\Theta^{ij}$ which satisfies
\er{huohuioy89gjjhjffffff3478zzrrZZZhjhhjhhjjhhffGGhjjhuiiuihhjkkoioikk}
in non-inertial cartesian coordinate systems and the scalar of the
global time $t$ are dependent on the geometry of the space-time only
and are fully determined in a given curvilinear coordinate system by
change of variables rules. In particular, the
four-\underline{covector} of the gravitational potential
$(v_0,v_1,v_2,v_3)$ is fully determined in the given curvilinear
coordinate system, since we have:
\begin{equation}\label{fgjfjhfgfgfgfhihhjhhjjhhjhjjhjhhjhjhjhjhjhhjhj}
v_k=c\frac{\partial t}{\partial x^k}\quad\quad\forall\, k=0,1,2,3.
\end{equation}
Moreover, by
\er{fgjfjhgghhgjghjhjijhojihjhjjijhjjjjjuiijjjkhjhjhjuiiuuuyuuoiuuiio}
and \er{fgjfjhfgfgfgfhihhjhhjjhhjhj} we have the following covariant
identities which are valid in every curvilinear coordinate system:
\begin{equation}\label{fgjfjhfgfgfgfhihhjhhjjhhjhjjhjhhjhj}
\begin{cases}
\sum_{k=0}^{3}{\Theta}^{mk}v_k=\sum_{k=0}^{3}c\,{\Theta}^{mk}\frac{\partial
t}{\partial x^k}=0\quad\quad\forall\, m=0,1,2,3\\
\sum_{k=0}^{3}\sum_{j=0}^{3}g^{kj}v_kv_j=\sum_{k=0}^{3}\sum_{j=0}^{3}c^2g^{kj}\frac{\partial
t}{\partial x^k}\frac{\partial t}{\partial x^j}=1.
\end{cases}
\end{equation}
However the four-\underline{vector} of the gravitational potential
$(v^0,v^1,v^2,v^3)$, the contravariant pseudometric tensor
$g^{mn}=v^mv^n-\Theta^{mn}$ and thus also the covariant pseudometric
tensor $g_{mn}$ depend also on the physical properties of the
matter. Without knowing the physical properties of the matter the
four-vector of the gravitational potential can be arbitrary vector
$(v^0,v^1,v^2,v^3)$ that satisfies:
\begin{equation}\label{fgjfjhfgfgfgfhihhjhhjjhhjhjjhjhhjhjkjh}
\sum_{k=0}^{3}v_kv^k=\sum_{k=0}^{3}c\frac{\partial t}{\partial
x^k}v^k
=1.
\end{equation}
Indeed for an arbitrary four-vector $(v^0,v^1,v^2,v^3)$ that
satisfies \er{fgjfjhfgfgfgfhihhjhhjjhhjhjjhjhhjhjkjh}, denoting
$g^{mn}:=v^mv^n-\Theta^{mn}$, using
\er{fgjfjhfgfgfgfhihhjhhjjhhjhjjhjhhjhj} and
\er{fgjfjhfgfgfgfhihhjhhjjhhjhjjhjhhjhjkjh} we clearly obtain the
following consistency:
\begin{equation}\label{fgjfjhfgfgfgfhihhjhhjjhhjhjjhjhhjhjkjhjjkjkk}
\sum_{j=0}^{3}g^{kj}v_j=\sum_{j=0}^{3}\left(v^kv^j-\Theta^{kj}\right)v_j=v^k\left(\sum_{j=0}^{3}v^jv_j\right)-\sum_{j=0}^{3}\Theta^{kj}v_j=v^k\quad\quad\forall
k=0,1,2,3.
\end{equation}
Thus we obtained that the four-vector of the gravitational potential
can be arbitrary four vector in \er{uuuuuuuu} that satisfies the
linear constraint \er{fgjfjhfgfgfgfhihhjhhjjhhjhjjhjhhjhjkjh} where
the four-covector $v_k$ is prescribed. So the four-vector
$(v^0,v^1,v^2,v^3)$ actually contains three independent scalar
functions similarly as in cartesian coordinate systems where we have
$v^0=1$. On the other hand, the four-vector $S^k$ contains four
independent scalar functions. Thus the Lagrangian in \er{uuuuuuuu}
depends on seven independent scalar functions characterizing the
gravitational field and the four-vector of electromagnetic
potential, exactly as in cartesian coordinate systems where we have
four independent scalar functions that characterize the
electromagnetic field: scalar $\Psi$ and three-dimensional vector
$\vec A$ and seven independent scalar functions that characterize
the gravitational field: three are contained in the
three-dimensional vectorial gravitational potential $\vec v$ and
other four are the ancillary scalar field $\Phi$ and the ancillary
three-dimensional vector field $\vec p$.

Finally note that, since our model of the Newtonian-type gravity in
the case of fully non-relativistic choice
\er{vhfffngghkjgghDDmmioiuiujhhjhjklkkl} and in the absence of
electromagnetic fields coincides with the classical Newtonian
gravitation, as a particular case, we obtained a covariant
formulation of the classical Newtonian gravity in curvilinear
coordinate systems.

\subsubsection{The case of some alternative model of the gravity}\label{gggggyffu997tg7gffttft999}

Consider $\vec k$ to be the vectorial potential of the inertia,
which is a generally trivial speed-like vector field, assumed to be
fixed in every fixed inertial or non-inertial cartesian coordinate
system (see Definition \ref{jjhgyftdtd}). In this subsection we find
a equivalent form of the Lagrangian density of the
gravitational-electromagnetic field in the case of the alternative
model of the gravity gravity having the general form
\er{vhfffngghkjgghDDnw22mm34} or
\er{vhfffngghkjgghDDnw22jkjkjkjk22gughhg22mm34}:
\begin{multline}\label{vhfffngghkjgghDDnw22mm3488}
L\left(\vec A,\Psi,\vec v,\Phi_0,\vec h,\vec x,t\right):=\\
\frac{1}{8\pi}\left|-\nabla_{\vec
x}\Psi-\frac{1}{c}\frac{\partial\vec A}{\partial t}+\frac{1}{c}\vec
v\times curl_{\vec x}\vec A\right|^2-\frac{1}{8\pi}\left|curl_{\vec
x}\vec A\right|^2-\left(\rho\Psi-\frac{1}{c}\vec A\cdot\vec j\right)
+\mu g\left(\frac{1}{2}\left|\vec u-\vec v\right|^2\right)
\\
-\frac{c^2}{8\pi G}\left|-\nabla_{\vec x}\Phi_0
-\frac{1}{c}\frac{\partial\vec h}{\partial t}
+\frac{1}{c}\vec v\times curl_{\vec x}\vec h-\frac{1}{c}\nabla_{\vec
x}\left(\vec h\cdot\vec v\right)\right|^2+\frac{c^2}{8\pi
G}\left|curl_{\vec x}\vec h\right|^2\\+\frac{c^2\beta}{4\pi
G}\left(\frac{1}{4}\left|d_{\vec x}\vec v+\left\{d_{\vec x}\vec
v\right\}^T\right|^2-\left|\Div_{\vec x}\vec v\right|^2\right),
\end{multline}
where
\begin{equation}\label{btfffygtgyggyDDnw22jkhk22mm3488}
\vec h=\vec v-\vec
k\quad\text{and}\quad\Phi_0=-\frac{1}{2c}\left|\vec v-\vec
k\right|^2,
\end{equation}
so that we have
\begin{multline}\label{vhfffngghkjgghDDnw22jkjkjkjk22gughhg22mm3488}
L\left(\vec A,\Psi,\vec v,\Phi_0,\vec h,\vec
x,t\right)=L_1\left(\vec A,\Psi,\vec v,\vec
x,t\right):=\\
\frac{1}{8\pi}\left|-\nabla_{\vec
x}\Psi-\frac{1}{c}\frac{\partial\vec A}{\partial t}+\frac{1}{c}\vec
v\times curl_{\vec x}\vec A\right|^2-\frac{1}{8\pi}\left|curl_{\vec
x}\vec A\right|^2-\left(\rho\Psi-\frac{1}{c}\vec A\cdot\vec
j\right)+\mu g\left(\frac{1}{2}\left|\vec u-\vec v\right|^2\right)
\\
-\frac{c^2}{8\pi G}\left|
\frac{1}{c}\nabla_{\vec x}\left(\frac{1}{2}\left|\vec v-\vec
k\right|^2\right)-\frac{1}{c}\frac{\partial}{\partial t}\left(\vec
v-\vec k\right)
+\frac{1}{c}\vec v\times curl_{\vec x}\left(\vec v-\vec
k\right)-\frac{1}{c}\nabla_{\vec x}\left(\left(\vec v-\vec
k\right)\cdot\vec v\right)\right|^2\\+\frac{c^2}{8\pi
G}\left|curl_{\vec x}\left(\vec v-\vec
k\right)\right|^2+\frac{c^2\beta}{4\pi
G}\left(\frac{1}{4}\left|d_{\vec x}\vec v+\left\{d_{\vec x}\vec
v\right\}^T\right|^2-\left|\Div_{\vec x}\vec v\right|^2\right).
\end{multline}
Our purpose is to make the equivalent form of this Lagrangian
density to be covariant and valid in every \underline{curvilinear}
coordinate system. Assume first, that we deal with a cartesian
inertial or non-inertial coordinate system. Consider again the
three-dimensional vectorial gravitational potential $\vec
v=(v^1,v^2,v^3)$ and consider the covariant pseudometric tensor
$G=\{g_{ij}\}_{0\leq i,j\leq 3}$ defined by
\er{hoyuiouigyfg3478zzrrZZff} and the contravariant pseudometric
tensor $\tilde G=\{g^{ij}\}_{0\leq i,j\leq 3}$ defined by
\er{hoyuiouigyfghgjh3478zzrrZZff}. Next, as before, we consider the
four-dimensional gravitational potential $(v^0,v^1,v^2,v^3)$ defined
by \er{fgjfjhgghhgjghjhjijhojihjhjjijhjjjjjuiijjjkhjhjhjklkk} and
the corresponding lowered four-covector field $(v_0,v_1,v_2,v_3)$,
that we called the four-covector of gravitational potential, defined
by \er{fgjfjhgghhgjghjhjijhojihjhjjijhjjjjjuiikkjjnj1kkkl} and, as
in \er{fgjfjhgghhghjjgj}, denote
\begin{equation}\label{fgjfjhgghhghjjgjhkkhjhjh}
\hat Y_{ij}:=\delta_j v_i+\delta_i v_j\quad\quad\forall\, 0\leq
i,j\leq 3,
\end{equation}
where $\delta_j$ means the covariant derivative with respect to the
dynamical pseudo-metrics $G=\{g_{ij}\}_{0\leq i,j\leq 3}$. Then by
the particular case of
\er{hoyuiouigyfghgjh3478zzrrZZjhjhugyuyughggjhjhjyuuygtyffjjjkjljkjjljokjljjkjkopopkklklkjjkjkjjkkokkjoipio;l;l;ll;lkklkjjkkklkkkkk}
with $\hat Z_{ij}=\hat Y_{ij}$ we have
\begin{multline}\label{hoyuiouigyfghgjh3478zzrrZZjhjhugyuyughggjhjhjyuuygtyffjjjkjljkjjljokjljjkjkopopkklklkjjkjkjjkkokkjoipio;l;l;ll;lkklkjjkkklkkkkkjojjh}
\left(\left(div_{\vec x}\vec v\right)^2-\frac{1}{4}\left|d_{\vec
x}\vec v+\left\{d_{\vec x}\vec v\right\}^T\right|^2\right)=\\
\frac{c^2}{4}\left(\left(\sum_{i=0}^{3}\sum_{j=0}^{3}g^{ij}\left(\delta_j
v_i+\delta_i
v_j\right)\right)^2-\sum_{k=0}^{3}\sum_{j=0}^{3}\sum_{m=0}^{3}\sum_{n=0}^{3}\left(\delta_j
v_k+\delta_k v_j\right)g^{km}g^{jn}\left(\delta_m v_n+\delta_n
v_m\right)\right)
.
\end{multline}
Furthermore, consider the Dynamical four-covector of genuine gravity
$(s_0,s_1,s_2,s_3)$, defined as in
\er{fgjfjhgghhgjghjhjijhojihjhjjijhjjjjjuiijjjk88} by:
\begin{multline}\label{fgjfjhgghhgjghjhjijhojihjhjjijhjjjjjuiijjjk88jihhhi}
(s_0,s_1,s_2,s_3)
=-\frac{1}{c}\left(\left(\Phi_0+\frac{1}{c}\vec v\cdot\vec
h\right),-\vec h\right)=\left(\frac{1}{2c^2}|\vec
h|^2-\frac{1}{c^2}\vec v\cdot\vec h\,,\,\frac{1}{c}\vec
h\right)\quad\quad\text{where}\quad\\ s_0
=-\frac{1}{c}\left(\Phi_0+\frac{1}{c}\vec v\cdot\vec
h\right)=\frac{1}{2c^2}|\vec h|^2-\frac{1}{c^2}\vec v\cdot\vec
h\;\;\quad\text{and}\quad\;\;(s_1,s_2,s_3)=\frac{1}{c}\vec h,
\end{multline}
where $\vec h$ and $\Phi_0$ are given by
\er{btfffygtgyggyDDnw22jkhk22mm3488}. Next, as in
\er{MaxVacFullPPNhjjghjjkjhhiuyy} and
\er{vhfffngghkjgghPPNggjgjjkgjoiuioiggh} we have the following:
\begin{multline}\label{vhfffngghkjgghPPNggjgjjkgjoiuijouuiyiyyyuyu}
\frac{1}{4\pi}\left(\frac{1}{2}\left|-\nabla_{\vec
x}\Psi-\frac{1}{c}\frac{\partial\vec A}{\partial t}+\frac{1}{c}\vec
v\times curl_{\vec x}\vec A\right|^2-\frac{1}{2}\left|curl_{\vec
x}\vec A\right|^2-4\pi\left(\rho\Psi-\frac{1}{c}\vec A\cdot\vec
j\right)\right)=\\
\frac{1}{4\pi}\left(-\sum_{n=0}^{3}\sum_{k=0}^{3}\sum_{m=0}^{3}\sum_{p=0}^{3}\frac{1}{4}g^{mn}g^{pk}\left(\frac{\partial
A_p}{\partial x^m}-\frac{\partial A_m}{\partial
x^p}\right)\left(\frac{\partial A_k}{\partial x^n}-\frac{\partial
A_n}{\partial x^k}\right)-\sum_{k=0}^{3}4\pi j^k A_k\right),
\end{multline}
where $(A_0,A_1,A_2,A_3)$ is the four-covector of thefour
dimensional electromagnetic potential, defined as in
\er{fgjfjhgghhgjghjhjijhojihjhjjijhjjjljljpk} by:
\begin{equation}\label{fgjfjhgghhgjghjhjijhojihjhjjijhjjjljljpkjkjkjoj}
(A_0,A_1,A_2,A_3)=(\Psi,-\vec A)\quad\text{where}\quad
A_0=\Psi\;\;\text{and}\;\;(A_1,A_2,A_3)=-\vec A,
\end{equation}
and $(j^0,j^1,j^2,j^3)$ is the four-vector of the current, given by
\er{btfffjhgjghghijhhjhhhjjkjkkjjhhhjkjjkjhjkllkj}. Then, completely
analogously, as we get in \er{MaxVacFullPPNhjjghjjkjhhiuyy} and
\er{vhfffngghkjgghPPNggjgjjkgjoiuioiggh} the following part of
\er{vhfffngghkjgghPPNggjgjjkgjoiuijouuiyiyyyuyu}:
\begin{multline}\label{vhfffngghkjgghPPNggjgjjkgjoiuipopiioio}
\left(\frac{1}{2}\left|-\nabla_{\vec
x}\Psi-\frac{1}{c}\frac{\partial\vec A}{\partial t}+\frac{1}{c}\vec
v\times curl_{\vec x}\vec A\right|^2-\frac{1}{2}\left|curl_{\vec
x}\vec A\right|^2\right)=\\
-\sum_{n=0}^{3}\sum_{k=0}^{3}\sum_{m=0}^{3}\sum_{p=0}^{3}\frac{1}{4}g^{mn}g^{pk}\left(\frac{\partial
A_p}{\partial x^m}-\frac{\partial A_m}{\partial
x^p}\right)\left(\frac{\partial A_k}{\partial x^n}-\frac{\partial
A_n}{\partial x^k}\right),
\end{multline}
we can get also:
\begin{multline}\label{vhfffngghkjgghPPNggjgjjkgjoiuipopiioioouuou}
\left(\frac{1}{2}\left|-\nabla_{\vec x}\Phi_0
-\frac{1}{c}\frac{\partial\vec h}{\partial t}
+\frac{1}{c}\vec v\times curl_{\vec x}\vec h-\frac{1}{c}\nabla_{\vec
x}\left(\vec h\cdot\vec
v\right)\right|^2-\frac{1}{2}\left|curl_{\vec
x}\vec h\right|^2\right)=\\
\left(\frac{1}{2}\left|-\nabla_{\vec x}\left(\Phi_0+\frac{1}{c}\vec
v\cdot\vec h\right) -\frac{1}{c}\frac{\partial\vec h}{\partial t}
+\frac{1}{c}\vec v\times curl_{\vec x}\vec
h\right|^2-\frac{1}{2}\left|curl_{\vec x}\vec
h\right|^2\right)=\\
-\sum_{n=0}^{3}\sum_{k=0}^{3}\sum_{m=0}^{3}\sum_{p=0}^{3}\frac{c^2}{4}g^{mn}g^{pk}\left(\frac{\partial
s_p}{\partial x^m}-\frac{\partial s_m}{\partial
x^p}\right)\left(\frac{\partial s_k}{\partial x^n}-\frac{\partial
s_n}{\partial x^k}\right).
\end{multline}
Then, similarly as it was done in
\er{vhfffngghkjgghDDmmioiuiujhhjhj}, by
\er{vhfffngghkjgghPPNggjgjjkgjoiuijouuiyiyyyuyu},
\er{vhfffngghkjgghPPNggjgjjkgjoiuipopiioioouuou} and
\er{hoyuiouigyfghgjh3478zzrrZZjhjhugyuyughggjhjhjyuuygtyffjjjkjljkjjljokjljjkjkopopkklklkjjkjkjjkkokkjoipio;l;l;ll;lkklkjjkkklkkkkkjojjh}
we rewrite the Lagrangian density in \er{vhfffngghkjgghDDnw22mm3488}
or \er{vhfffngghkjgghDDnw22jkjkjkjk22gughhg22mm3488} in the
cartesian coordinate system as:
\begin{multline}\label{vhfffngghkjgghDDmmioiuiujhhjhjjhyryrr}
\frac{1}{8\pi}\left|-\nabla_{\vec
x}\Psi-\frac{1}{c}\frac{\partial\vec A}{\partial t}+\frac{1}{c}\vec
v\times curl_{\vec x}\vec A\right|^2-\frac{1}{8\pi}\left|curl_{\vec
x}\vec A\right|^2-\left(\rho\Psi-\frac{1}{c}\vec A\cdot\vec j\right)
+\mu g\left(\frac{1}{2}\left|\vec u-\vec v\right|^2\right)
\\
-\frac{c^2}{8\pi G}\left|-\nabla_{\vec x}\Phi_0
-\frac{1}{c}\frac{\partial\vec h}{\partial t}
+\frac{1}{c}\vec v\times curl_{\vec x}\vec h-\frac{1}{c}\nabla_{\vec
x}\left(\vec h\cdot\vec v\right)\right|^2+\frac{c^2}{8\pi
G}\left|curl_{\vec x}\vec h\right|^2\\+\frac{c^2\beta}{4\pi
G}\left(\frac{1}{4}\left|d_{\vec x}\vec v+\left\{d_{\vec x}\vec
v\right\}^T\right|^2-\left|\Div_{\vec x}\vec v\right|^2\right)
=L_{ge}=
\\
\frac{1}{4\pi}\left(-\sum_{n=0}^{3}\sum_{k=0}^{3}\sum_{m=0}^{3}\sum_{p=0}^{3}\frac{1}{4}g^{mn}g^{pk}\left(\frac{\partial
A_p}{\partial x^m}-\frac{\partial A_m}{\partial
x^p}\right)\left(\frac{\partial A_k}{\partial x^n}-\frac{\partial
A_n}{\partial x^k}\right)-\sum_{k=0}^{3}4\pi j^k A_k\right)\\
+\frac{\hat \mu}{\sqrt{\left|\text{det}\,G\right|}}
\left(\sum\limits_{m=0}^{3}\sum\limits_{k=0}^{3}g_{mk}\,\hat
u^m_{x^0}\,u^k_{t}\right)\left(\sum\limits_{m=0}^{3}\sum\limits_{k=0}^{3}g_{mk}\,
u^m_{t}\,u^k_{t}\right)^{-1}g\left(\frac{c^2}{2}-\sum\limits_{m=0}^{3}\sum\limits_{k=0}^{3}\frac{c^2}{2}g_{mk}\,
u^m_{t}\,u^k_{t}\right)
\\
+\frac{c^4}{4\pi
G}\left(\sum_{n=0}^{3}\sum_{k=0}^{3}\sum_{m=0}^{3}\sum_{p=0}^{3}\frac{1}{4}g^{mn}g^{pk}\left(\frac{\partial
s_p}{\partial x^m}-\frac{\partial s_m}{\partial
x^p}\right)\left(\frac{\partial s_k}{\partial x^n}-\frac{\partial
s_n}{\partial x^k}\right)\right)\\
-\frac{c^4\beta}{16\pi
G}\left(\left(\sum_{i=0}^{3}\sum_{j=0}^{3}g^{ij}\left(\delta_j
v_i+\delta_i
v_j\right)\right)^2-\sum_{k=0}^{3}\sum_{j=0}^{3}\sum_{m=0}^{3}\sum_{n=0}^{3}\left(\delta_j
v_k+\delta_k v_j\right)g^{km}g^{jn}\left(\delta_m v_n+\delta_n
v_m\right)\right) ,
\end{multline}
where the right hand side is written in the covariant form and the
second equality is valid in every curvilinear coordinate system with
$\frac{\partial t}{\partial x^0}>0$, $(j^0,j^1,j^2,j^3)$ is the
four-vector of the current, $\hat\mu$, $u^j_{t}$ and $u^j_{x^0}$ are
the same as in
\er{btfffjhgjghghijhhjhhhjjkjkkjjhhhjkjjkjhjjkkjklkklijhjhjkmhggmhjhjhkkk},
and $\vec h$ and $\Phi_0$ are given by
\begin{equation}\label{btfffygtgyggyDDnw22jkhk22mm3488jkjljljujhkh}
\vec h=\vec v-\vec
k\quad\text{and}\quad\Phi_0=-\frac{1}{2c}\left|\vec v-\vec
k\right|^2.
\end{equation}
Moreover in the particular case of the relativistic-like choice
$g(s):=-c^2\sqrt{1-\frac{2s}{c^2}}$, by
\er{btfffjhgjghghijhhjhhhjjkjkkjjhhhjkjjkjhjjkkjklkklijhjhjkmhggm}
we can write an alternative to
\er{vhfffngghkjgghDDmmioiuiujhhjhjjhyryrr} as:
\begin{multline}\label{vhfffngghkjgghDDmmioiuiujhhjhjjhyryrrjkjhhh}
\frac{1}{8\pi}\left|-\nabla_{\vec
x}\Psi-\frac{1}{c}\frac{\partial\vec A}{\partial t}+\frac{1}{c}\vec
v\times curl_{\vec x}\vec A\right|^2-\frac{1}{8\pi}\left|curl_{\vec
x}\vec A\right|^2-\left(\rho\Psi-\frac{1}{c}\vec A\cdot\vec j\right)
+\mu g\left(\frac{1}{2}\left|\vec u-\vec v\right|^2\right)
\\
-\frac{c^2}{8\pi G}\left|-\nabla_{\vec x}\Phi_0
-\frac{1}{c}\frac{\partial\vec h}{\partial t}
+\frac{1}{c}\vec v\times curl_{\vec x}\vec h-\frac{1}{c}\nabla_{\vec
x}\left(\vec h\cdot\vec v\right)\right|^2+\frac{c^2}{8\pi
G}\left|curl_{\vec x}\vec h\right|^2\\+\frac{c^2\beta}{4\pi
G}\left(\frac{1}{4}\left|d_{\vec x}\vec v+\left\{d_{\vec x}\vec
v\right\}^T\right|^2-\left|\Div_{\vec x}\vec v\right|^2\right)
=L_{ge}=
\\
\frac{1}{4\pi}\left(-\sum_{n=0}^{3}\sum_{k=0}^{3}\sum_{m=0}^{3}\sum_{p=0}^{3}\frac{1}{4}g^{mn}g^{pk}\left(\frac{\partial
A_p}{\partial x^m}-\frac{\partial A_m}{\partial
x^p}\right)\left(\frac{\partial A_k}{\partial x^n}-\frac{\partial
A_n}{\partial x^k}\right)-\sum_{k=0}^{3}4\pi j^k A_k\right)\\
-\frac{c^2\hat \mu}{\sqrt{\left|\text{det}\,G\right|}}
\sqrt{\left(\sum\limits_{m=0}^{3}\sum\limits_{k=0}^{3}g_{mk}\,\hat
u^m_{x^0}\,\hat u^k_{x^0}\right)}
\\
+\frac{c^4}{4\pi
G}\left(\sum_{n=0}^{3}\sum_{k=0}^{3}\sum_{m=0}^{3}\sum_{p=0}^{3}\frac{1}{4}g^{mn}g^{pk}\left(\frac{\partial
s_p}{\partial x^m}-\frac{\partial s_m}{\partial
x^p}\right)\left(\frac{\partial s_k}{\partial x^n}-\frac{\partial
s_n}{\partial x^k}\right)\right)\\
-\frac{c^4\beta}{16\pi
G}\left(\left(\sum_{i=0}^{3}\sum_{j=0}^{3}g^{ij}\left(\delta_j
v_i+\delta_i
v_j\right)\right)^2-\sum_{k=0}^{3}\sum_{j=0}^{3}\sum_{m=0}^{3}\sum_{n=0}^{3}\left(\delta_j
v_k+\delta_k v_j\right)g^{km}g^{jn}\left(\delta_m v_n+\delta_n
v_m\right)\right) ,
\end{multline}
where the right hand side is written in the covariant form and the
second equality is valid in every curvilinear coordinate system with
$\frac{\partial t}{\partial x^0}>0$. On the other hand, in the case
of fully non-relativistic Lagrangian, where
$g(s)=\left(s-\frac{c^2}{2}\right)$, we can write an alternative to
\er{vhfffngghkjgghDDmmioiuiujhhjhjjhyryrr} as:
\begin{multline}\label{vhfffngghkjgghDDmmioiuiujhhjhjjhyryrrjkjhhhjklkjhhh}
\frac{1}{8\pi}\left|-\nabla_{\vec
x}\Psi-\frac{1}{c}\frac{\partial\vec A}{\partial t}+\frac{1}{c}\vec
v\times curl_{\vec x}\vec A\right|^2-\frac{1}{8\pi}\left|curl_{\vec
x}\vec A\right|^2-\left(\rho\Psi-\frac{1}{c}\vec A\cdot\vec j\right)
+\mu g\left(\frac{1}{2}\left|\vec u-\vec v\right|^2\right)
\\
-\frac{c^2}{8\pi G}\left|-\nabla_{\vec x}\Phi_0
-\frac{1}{c}\frac{\partial\vec h}{\partial t}
+\frac{1}{c}\vec v\times curl_{\vec x}\vec h-\frac{1}{c}\nabla_{\vec
x}\left(\vec h\cdot\vec v\right)\right|^2+\frac{c^2}{8\pi
G}\left|curl_{\vec x}\vec h\right|^2\\+\frac{c^2\beta}{4\pi
G}\left(\frac{1}{4}\left|d_{\vec x}\vec v+\left\{d_{\vec x}\vec
v\right\}^T\right|^2-\left|\Div_{\vec x}\vec v\right|^2\right)
=L_{ge}=
\\
\frac{1}{4\pi}\left(-\sum_{n=0}^{3}\sum_{k=0}^{3}\sum_{m=0}^{3}\sum_{p=0}^{3}\frac{1}{4}g^{mn}g^{pk}\left(\frac{\partial
A_p}{\partial x^m}-\frac{\partial A_m}{\partial
x^p}\right)\left(\frac{\partial A_k}{\partial x^n}-\frac{\partial
A_n}{\partial x^k}\right)-\sum_{k=0}^{3}4\pi j^k A_k\right)\\
-\frac{c^2\hat \mu}{\sqrt{\left|\text{det}\,G\right|}}
\left(\sum\limits_{m=0}^{3}\sum\limits_{k=0}^{3}g_{mk}\,\hat
u^m_{x^0}\,u^k_{t}\right)
\\
+\frac{c^4}{4\pi
G}\left(\sum_{n=0}^{3}\sum_{k=0}^{3}\sum_{m=0}^{3}\sum_{p=0}^{3}\frac{1}{4}g^{mn}g^{pk}\left(\frac{\partial
s_p}{\partial x^m}-\frac{\partial s_m}{\partial
x^p}\right)\left(\frac{\partial s_k}{\partial x^n}-\frac{\partial
s_n}{\partial x^k}\right)\right)\\
-\frac{c^4\beta}{16\pi
G}\left(\left(\sum_{i=0}^{3}\sum_{j=0}^{3}g^{ij}\left(\delta_j
v_i+\delta_i
v_j\right)\right)^2-\sum_{k=0}^{3}\sum_{j=0}^{3}\sum_{m=0}^{3}\sum_{n=0}^{3}\left(\delta_j
v_k+\delta_k v_j\right)g^{km}g^{jn}\left(\delta_m v_n+\delta_n
v_m\right)\right) ,
\end{multline}
where again the right hand side is written in the covariant form and
the second equality is valid in every curvilinear coordinate system
with $\frac{\partial t}{\partial x^0}>0$.

Note that the four-vector of the gravitational potential
$(v^0,v^1,v^2,v^3)$ and the four-covector of the genuine gravity
$(s_0,s_1,s_2,s_3)$ are \underline{not} independent arguments of
$L_{ge}$. Moreover, in the general non-cartesian or curvilinear
coordinate system, knowing $(v^0,v^1,v^2,v^3)$ and thus also knowing
$\{g_{ij}\}_{0\leq i,j\leq 3}$, as before in
\er{huohuioy89gjjhjffffff3478zzrrZZZhjhhjhhjjhhffGGhjjhuiiuihhjkkoioirrkhrrjggggjrrjiokuukuppkpklhjjhj88khhjh},
we can find $(s_0,s_1,s_2,s_3)$ by:
\begin{equation}\label{huohuioy89gjjhjffffff3478zzrrZZZhjhhjhhjjhhffGGhjjhuiiuihhjkkoioirrkhrrjggggjrrjiokuukuppkpklhjjhj88khhjhjkojj}
s_j=\frac{1}{2}\left(\sum_{m=0}^{3} g_{jm}k^m-\sum_{m=0}^{3}
J_{jm}v^m\right)\quad \forall j=0,1,2,3.
\end{equation}
where $(k^0,k^1,k^2,k^3)$ is the four-vector of the potential of
inertia and $\{J_{ij}\}_{0\leq i,j\leq 3}$ is the covariant
kinematic pseudo-metric tensor of inertia, that are prescribed in
every cartesian, non-cartesian or curvilinear coordinate system. On
the other hand, knowing $(s_0,s_1,s_2,s_3)$, we can find
$\{g_{ij}\}_{0\leq i,j\leq 3}$ and $(v^0,v^1,v^2,v^3)$ by the
following covariant identities
\begin{equation}\label{hoyuiouigyfghgjh3478zzrrZZffGGjkkjojj88iihhkhk88jklhkhjkhk8899ouiuui}
{g}_{ij}=J_{ij}+\left(s_i\left(c\,\frac{\partial t}{\partial
x^j}\right)+\left(c\,\frac{\partial t}{\partial
x^i}\right)s_j\right)\quad\quad\forall\,i,j=0,1,2,3
\end{equation}
(see
\er{hoyuiouigyfghgjh3478zzrrZZffGGjkkjojj88iihhkhk88jklhkhjkhk8899}),
and
\begin{equation}\label{hoyuiouigyfghgjh3478zzrrZZffGGjkkjojj88jklkhhk88khjhgh8899kpkjkl}
v^j-k^j=\sum_{m=0}^{3}{\Theta}^{jm}s_m=\sum_{m=0}^{3}\left(k^jk^m-J^{jm}\right)s_m\quad\quad\forall\,j=0,1,2,3
\end{equation}
(see
\er{hoyuiouigyfghgjh3478zzrrZZffGGjkkjojj88jklkhhk88khjhgh8899}).

Once we wrote the Lagrangian density
$L_{ge}:=L_{ge1}\left(v^k,A^k,x^k\right)_{k=0,\ldots
4}=L_{ge2}\left(s^k,A^k,x^k\right)_{k=0,\ldots 4}$ as a covariant
scalar, under the changes of curvilinear coordinate systems, we can
write a covariant Lagrangian as:
\begin{multline}\label{uuuuuuuujlkjjklkjk}
J_{ge1}(v^k,A^k)=J_{ge1}(s^k,A^k):=\int_{(x^0,x^1,x^2,x^3)}L_{ge1}\left(v^k,A^k,x^k\right)\,\sqrt{\left|\text{det}\,G\right|}\,dx^0dx^1dx^2dx^3
\\=\int_{(x^0,x^1,x^2,x^3)}L_{ge2}\left(s^k,A^k,x^k\right)\,\sqrt{\left|\text{det}\,G\right|}\,dx^0dx^1dx^2dx^3.
\end{multline}
Although we need a term $\sqrt{\left|\text{det}\,G\right|}$ for the
covariance of the Lagrangian in curvilinear coordinate systems, in
the cartesian coordinate systems we always have
$\sqrt{\left|\text{det}\,G\right|}=1$.

Finally, note again, that the four-vector of the gravitational
potential, as an independent argument, is restricted by
\er{fgjfjhfgfgfgfhihhjhhjjhhjhjjhjhhjhjkjh}:
\begin{equation}\label{fgjfjhfgfgfgfhihhjhhjjhhjhjjhjhhjhjkjhojjjjjlljj}
\sum_{k=0}^{3}c\frac{\partial t}{\partial x^k}v^k
=1.
\end{equation}
So the four-vector $(v^0,v^1,v^2,v^3)$ actually contains three
independent scalar functions similarly as in cartesian coordinate
systems where we have $v^0=1$. On the other hand if we consider
$(s_0,s_1,s_2,s_3)$ instead of $(v^0,v^1,v^2,v^3)$ , as an
independent argument, then by
\er{fgjfjhgghhgjghjhjkkkkgjghghuiiiuujhjhljhgugii88} and
\er{hoyuiouigyfghgjh3478zzrrZZffGGjkkjojj88iihhkhk88jklhkhjkhk8899ouiuui}
it is restricted by the following identity
\begin{equation}\label{fgjfjhgghhgjghjhjkkkkgjghghuiiiuujhjhljhgugii88jklkljjkk;kkl}
\text{det}\,\left(\left\{J_{ij}+s_i\left(c\,\frac{\partial
t}{\partial x^j}\right)+\left(c\,\frac{\partial t}{\partial
x^i}\right)s_j\right\}_{0\leq i,j\leq
3}\right)=\text{det}\,\left(\{J_{ij}\}_{0\leq i,j\leq 3}\right),
\end{equation}
which is valid in every cartesian, non-cartesian or curvilinear
coordinate system. So the four-covector $(s_0,s_1,s_2,s_3)$ also
contains three independent scalar functions.


\section{Relativistic-like Dirac equation}\label{fgffggfggfhhjh}
\subsection{Classical Relativistik-like Lagrangian and Hamiltonian
of the motion of particles}
As in
\er{btfffygtgyggyijhhkkSYSPNuhuygyygyggyuyyuyuykuhghgjjojiyyyuyuuyy}
consider the relativistic-like Lagrangian of the motion of the
particle with mass $m$ and charge $\sigma$ in the outer
gravitational and electromagnetic fields and additional field with
potential $V(\vec x,t)$:
\begin{multline}\label{btfffygtgyggyijhhkkSYSPNuhuygyygyggyuyyuyuykuhghgjjojiyyyuyuuyyKK}
J_{rl}(\vec r)=\int_0^T L_0\left(\frac{d\vec r}{dt},\vec
r,t\right)dt:=\\ \int_0^T\left\{
-mc^2\sqrt{1-\frac{1}{c^2}\left|\frac{d\vec r}{dt}-\vec v(\vec
r,t)\right|^2}-\sigma\left(\Psi(\vec r,t)-\frac{1}{c}\vec A(\vec
r,t)\cdot\frac{d\vec r}{dt}\right)+V\left(\vec r,t\right)\right\}dt.
\end{multline}
Next define the generalized momentum of the particle by
\begin{equation}\label{guytyurtydftyiujhSYShmhmKK}
\vec P:=\nabla_{\vec r'}L_0\left(\vec r',\vec
r,t\right)=m\left(1-\frac{1}{c^2}\left|\frac{d\vec r}{dt}-\vec
v(\vec r,t)\right|^2\right)^{-\frac{1}{2}}\left(\frac{d\vec
r}{dt}-\vec v(\vec r,t)\right)+\frac{\sigma}{c}\vec A(\vec r,t).
\end{equation}
Then
\begin{equation}\label{guytyurtydftyiujhioyhuyhSYShmhmKK}
\left(1-\frac{1}{c^2}\left|\frac{d\vec r}{dt}-\vec v(\vec
r,t)\right|^2\right)^{-\frac{1}{2}}\left(\frac{d\vec r}{dt}-\vec
v(\vec r,t)\right)=\left(\frac{1}{m}\vec P-\frac{\sigma}{mc}\vec
A(\vec r,t)\right).
\end{equation}
So
\begin{equation}\label{guytyurtydftyiujhioyhuyhSYShmhmuiiuiuKK}
\frac{d\vec r}{dt}=\left(1+\frac{1}{c^2}\left|\frac{1}{m}\vec
P-\frac{\sigma}{mc}\vec A(\vec
r,t)\right|^2\right)^{-\frac{1}{2}}\left(\frac{1}{m}\vec
P-\frac{\sigma}{mc}\vec A(\vec r,t)\right)+\vec v(\vec r,t).
\end{equation}
Thus if we consider a Hamiltonian
\begin{equation}\label{vhfffngghkjgghfjjhyjjfgSYShmhmKK}
H_0\left(\vec P,\vec r,t\right):=\vec P\cdot\frac{d\vec
r}{dt}-L_0\left(\frac{d\vec r}{dt},\vec r,t\right)
\end{equation}
then by \er{vhfffngghkjgghfjjhyjjfgSYShmhmKK},
\er{btfffygtgyggyijhhkkSYSPNuhuygyygyggyuyyuyuykuhghgjjojiyyyuyuuyyKK},
\er{guytyurtydftyiujhioyhuyhSYShmhmKK} and
\er{guytyurtydftyiujhioyhuyhSYShmhmuiiuiuKK} we have:
\begin{multline}\label{vhfffngghkjgghfjjghghghSYShmyuuiiuuhmhmKK}
H_0\left(\vec P,\vec r,t\right)=-V\left(\vec r,t\right)+\vec
P\cdot\frac{d\vec r}{dt}
+\left(mc^2\left(1-\frac{1}{c^2}\left|\frac{d\vec r}{dt}-\vec v(\vec
r,t)\right|^2\right)^{\frac{1}{2}}+\sigma\left(\Psi(\vec
r,t)-\frac{1}{c}\vec A(\vec
r,t)\cdot\frac{d\vec r}{dt}\right)\right)=\\
-V\left(\vec
r,t\right)+mc^2\left(1+\frac{1}{c^2}\left|\frac{1}{m}\vec
P-\frac{\sigma}{mc}\vec A(\vec r,t)\right|^2\right)^{-\frac{1}{2}}
\\
+\vec P\cdot\left(\left(1+\frac{1}{c^2}\left|\frac{1}{m}\vec
P-\frac{\sigma}{mc}\vec A(\vec
r,t)\right|^2\right)^{-\frac{1}{2}}\left(\frac{1}{m}\vec
P-\frac{\sigma}{mc}\vec A(\vec r,t)\right)+\vec v(\vec
r,t)\right)\\+\sigma\left(\Psi(\vec r,t)-\frac{1}{c}\vec A(\vec
r,t)\cdot\left(\left(1+\frac{1}{c^2}\left|\frac{1}{m}\vec
P-\frac{\sigma}{mc}\vec A(\vec
r,t)\right|^2\right)^{-\frac{1}{2}}\left(\frac{1}{m}\vec
P-\frac{\sigma}{mc}\vec A(\vec r,t)\right)+\vec v(\vec
r,t)\right)\right)\\=
mc^2\left(1+\frac{1}{c^2}\left|\frac{1}{m}\vec
P-\frac{\sigma}{mc}\vec A(\vec r,t)\right|^2\right)^{\frac{1}{2}}
+\sigma\left(\Psi(\vec r,t)-\frac{1}{c}\vec v(\vec r,t)\cdot\vec
A(\vec r,t)\right)-V\left(\vec r,t\right)+\vec v(\vec r,t)\cdot\vec
P.
\end{multline}
So, the relativistic-like Hamiltonian for a macro-particles has the
form:
\begin{multline}\label{vhfffngghkjgghfjjghghghSYShmyuuiiuuhmhmiopoopKK}
H_0\left(\vec P,\vec r,t\right)=
mc^2\left(1+\frac{1}{c^2}\left|\frac{1}{m}\vec
P-\frac{\sigma}{mc}\vec A(\vec r,t)\right|^2\right)^{\frac{1}{2}}
+\sigma\left(\Psi(\vec r,t)-\frac{1}{c}\vec v(\vec r,t)\cdot\vec
A(\vec r,t)\right)\\-V\left(\vec r,t\right)+\vec v(\vec
r,t)\cdot\vec P.
\end{multline}
In particular, if similarly to
\er{fgjfjhgghhgjghjhjijhojihjhjjijhjjjljljpkhkhjgjg} we define the
four-dimensional generalized momentum $(P_0,P_1,P_2,P_3)$ as:
\begin{equation}\label{fgjfjhgghhgjghjhjijhojihjhjjijhjjjljljpkhkhjgjgjj}
(P_0,P_1,P_2,P_3):=\left(\frac{1}{c}H_0,-\vec
P\right)\quad\text{where}\quad
P_0=\frac{1}{c}H_0\;\;\text{and}\;\;(P_1,P_2,P_3)=-\vec P,
\end{equation}
Then, since by \er{guytyurtydftyiujhSYShmhmKK} and
\er{vhfffngghkjgghfjjghghghSYShmyuuiiuuhmhmiopoopKK}, under the
change of non-inertial cartesian coordinate system $H_0$ and $\vec
P$ transform as
\begin{equation}\label{vhfffngghhjghhgPPNghghghutghfflklhjkjhjhjjgjkghhjhhjhjkhjghlkklljj}
\begin{cases}
H_0'=
H_0+\left(\frac{dA}{dt}(t)\cdot\vec x+\frac{d\vec
z}{dt}(t)\right)\cdot\left(A(t)\cdot\vec P\right)
\\
\vec P'=A(t)\cdot \vec P,
\end{cases}
\end{equation}
by comparing
\er{vhfffngghhjghhgPPNghghghutghfflklhjkjhjhjjgjkghhjhhjhjkhjghlkklljj}
with \er{fgjfjhgghhgjghjhjijhojpiiihjhjuiouj} we deduce that the
four-dimensional momentum $(P_0,P_1,P_2,P_3)$ is a
four-\underline{covector} on the group $\mathcal{S}_0$ that is the
group of changes of cartesian non-inertial coordinate systems.

\subsection{Dirac equation}
Next consider the motion of a spin-half quantum relativistic-like
micro-particle with inertial mass $m$ and charge $\sigma$ in the
outer gravitational and electromagnetical field with characteristics
$\vec v(\vec x,t)$, $\vec A(\vec x,t)$ and $\Psi(\vec x,t)$ and
additional conservative field with potential $V(\vec y,t)$. The
evolution equation for this particle is
\begin{equation}\label{vhfffngghkjgghfjjghghghhjghjgghkghggkghghjghSYSPNnnkkJJ}
i\hbar\frac{\partial\psi}{\partial t}=\hat H_0\cdot\psi,
\end{equation}
where $\psi(\vec x,t)=\left(\psi_1(\vec x,t),\psi_2(\vec
x,t)\right)\in\mathbb{C}^2\times \mathbb{C}^2$ is a four-component
wave function and $\hat H_0$ is the Hamiltonian operator. Since the
relativistic-like Hamiltonian for a macro-particles has the form
\er{vhfffngghkjgghfjjghghghSYShmyuuiiuuhmhmiopoopKK},
%
%
%
analogously to the usual Dirac Hamiltonian operator, we built the
Hermitian Hamiltonian operator as $\hat H_0\cdot\psi=\left(\hat
H_1\cdot\psi,\hat H_2\cdot\psi\right)$, where
\begin{multline}\label{vhfffngghkjgghfjjghghghSYShmyuuiiuuhmhmiopoopnniukjhjkkkJJ}
\hat H_1\cdot\psi= mc^2\psi_1-c\,\vec S\cdot\left(i\hbar\nabla_{\vec
x}\psi_2+\frac{\sigma}{c}\vec A(\vec x,t)\psi_2\right)
%
%
%
+\sigma\left(\Psi(\vec x,t)-\frac{1}{c}\vec v(\vec x,t)\cdot\vec
A(\vec x_j,t)\right)\psi_1\\-V\left(\vec
x,t\right)\psi_1-\frac{i\hbar}{2}div_{\vec x}\left\{\psi_1\vec
v(\vec x,t)\right\}-\frac{i\hbar}{2}\nabla_{\vec x}\psi_1\cdot\vec
v(\vec x,t)+\frac{\hbar}{4}\vec S\cdot\left(curl_{\vec x}\vec v(\vec
x,t)\,\psi_1\right)\\-\frac{(g-1)\sigma\hbar}{2mc}\vec
S\cdot\left(curl_{\vec x}\vec A(\vec x,t)\,\psi_1\right),
\end{multline}
and
\begin{multline}\label{vhfffngghkjgghfjjghghghSYShmyuuiiuuhmhmiopoopnniukjhjkhhjkkJJ}
\hat H_2\cdot\psi= -mc^2\psi_2-c\,\vec
S\cdot\left(i\hbar\nabla_{\vec x}\psi_1+\frac{\sigma}{c}\vec A(\vec
x,t)\psi_1\right)
%
%
%
+\sigma\left(\Psi(\vec x,t)-\frac{1}{c}\vec v(\vec x,t)\cdot\vec
A(\vec x,t)\right)\psi_2\\-V\left(\vec
x,t\right)\psi_2-\frac{i\hbar}{2}div_{\vec x}\left\{\psi_2\vec
v(\vec x,t)\right\}-\frac{i\hbar}{2}\nabla_{\vec x}\psi_2\cdot\vec
v(\vec x,t)+\frac{\hbar}{4}\vec S\cdot\left(curl_{\vec x}\vec v(\vec
x,t)\,\psi_2\right)\\-\frac{(g-1)\sigma\hbar}{2mc}\vec
S\cdot\left(curl_{\vec x}\vec A(\vec x,t)\,\psi_2\right),
\end{multline}
where $\vec S:=(S_1,S_2,S_3)$ and
$$S_1=\left(\begin{matrix}0&1\\1&0\end{matrix}\right),\quad
S_2=\left(\begin{matrix}0&-i\\i&0\end{matrix}\right)\quad
S_3=\left(\begin{matrix}1&0\\0&-1\end{matrix}\right)$$ are Pauli
matrices and $g$ is a constant that depend on the type of the
particle (for electron we have $g=1$). As before for the
Schr\"{o}dinger-Pauli equation, we added an additional term to the
Hamiltonian, namely $\frac{\hbar}{4}\vec S\cdot\left(curl_{\vec
x}\vec v(\vec x,t)\,\psi\right)$. Although in the case of the
Newtonian-type gravity, this term vanishes in inertial coordinate
systems, it provides however, invariance of our Dirac-type equation,
under the change of non-inertial coordinate systems as we will see
below. Thus, we have the following two evolution equations that we
call together Dirac system of equations:
\begin{multline}\label{vhfffngghkjgghfjjghghghSYShmyuuiiuuhmhmiopoopnniukjhjkk;l;lkkkJJ}
i\hbar\frac{\partial\psi_1}{\partial t}=mc^2\psi_1-c\,\vec
S\cdot\left(i\hbar\nabla_{\vec x}\psi_2+\frac{\sigma}{c}\vec A(\vec
x,t)\psi_2\right)
%
%
%
+\sigma\left(\Psi(\vec x,t)-\frac{1}{c}\vec v(\vec x,t)\cdot\vec
A(\vec x_j,t)\right)\psi_1\\-V\left(\vec
x,t\right)\psi_1-\frac{i\hbar}{2}div_{\vec x}\left\{\psi_1\vec
v(\vec x,t)\right\}-\frac{i\hbar}{2}\nabla_{\vec x}\psi_1\cdot\vec
v(\vec x,t)+\frac{\hbar}{4}\vec S\cdot\left(curl_{\vec x}\vec v(\vec
x,t)\,\psi_1\right)\\-\frac{(g-1)\sigma\hbar}{2mc}\vec
S\cdot\left(curl_{\vec x}\vec A(\vec x,t)\,\psi_1\right),
\end{multline}
and
\begin{multline}\label{vhfffngghkjgghfjjghghghSYShmyuuiiuuhmhmiopoopnniukjhjkhhjl;kkJJ}
i\hbar\frac{\partial\psi_2}{\partial t}=-mc^2\psi_2-c\,\vec
S\cdot\left(i\hbar\nabla_{\vec x}\psi_1+\frac{\sigma}{c}\vec A(\vec
x,t)\psi_1\right)
%
%
%
+\sigma\left(\Psi(\vec x,t)-\frac{1}{c}\vec v(\vec x,t)\cdot\vec
A(\vec x,t)\right)\psi_2\\-V\left(\vec
x,t\right)\psi_2-\frac{i\hbar}{2}div_{\vec x}\left\{\psi_2\vec
v(\vec x,t)\right\}-\frac{i\hbar}{2}\nabla_{\vec x}\psi_2\cdot\vec
v(\vec x,t)+\frac{\hbar}{4}\vec S\cdot\left(curl_{\vec x}\vec v(\vec
x,t)\,\psi_2\right)\\-\frac{(g-1)\sigma\hbar}{2mc}\vec
S\cdot\left(curl_{\vec x}\vec A(\vec x,t)\,\psi_2\right).
\end{multline}
Then we can rewrite Dirac equations as:
\begin{multline}\label{vhfffngghkjgghfjjghghghSYShmyuuiiuuhmhmiopoopnniukjhjkk;l;lkkkJJjjkjk}
i\hbar\left(\frac{\partial\psi_1}{\partial t}+\frac{1}{2}div_{\vec
x}\left\{\psi_1\vec v(\vec x,t)\right\}+\frac{1}{2}\nabla_{\vec
x}\psi_1\cdot\vec v(\vec x,t)\right)=mc^2\psi_1-c\,\vec
S\cdot\left(i\hbar\nabla_{\vec x}\psi_2+\frac{\sigma}{c}\vec A(\vec
x,t)\psi_2\right)
%
%
%
\\+\sigma\left(\Psi(\vec x,t)-\frac{1}{c}\vec v(\vec x,t)\cdot\vec
A(\vec x,t)\right)\psi_1-V\left(\vec
x,t\right)\psi_1+\frac{\hbar}{4}\vec S\cdot\left(curl_{\vec x}\vec
v(\vec x,t)\,\psi_1\right)\\
-\frac{(g-1)\sigma\hbar}{2mc}\vec S\cdot\left(curl_{\vec x}\vec
A(\vec x,t)\,\psi_1\right),
\end{multline}
and
\begin{multline}\label{vhfffngghkjgghfjjghghghSYShmyuuiiuuhmhmiopoopnniukjhjkhhjl;kkJJkllkkl}
i\hbar\left(\frac{\partial\psi_2}{\partial t}+\frac{1}{2}div_{\vec
x}\left\{\psi_2\vec v(\vec x,t)\right\}+\frac{1}{2}\nabla_{\vec
x}\psi_2\cdot\vec v(\vec x,t)\right)=-mc^2\psi_2-c\,\vec
S\cdot\left(i\hbar\nabla_{\vec x}\psi_1+\frac{\sigma}{c}\vec A(\vec
x,t)\psi_1\right)
%
%
%
\\+\sigma\left(\Psi(\vec x,t)-\frac{1}{c}\vec v(\vec x,t)\cdot\vec
A(\vec x,t)\right)\psi_2-V\left(\vec
x,t\right)\psi_2+\frac{\hbar}{4}\vec S\cdot\left(curl_{\vec x}\vec
v(\vec x,t)\,\psi_2\right)\\
-\frac{(g-1)\sigma\hbar}{2mc}\vec S\cdot\left(curl_{\vec x}\vec
A(\vec x,t)\,\psi_2\right).
\end{multline}
Then similarly to the proof of Theorem \ref{gjghghgghgintintrrZZ}
about the invariance of Shr\"{o}dinger-Pauli equation we can prove
the following Theorem for Dirac equations:

\begin{theorem}\label{gjghghgghgintintrrZZjkh}
Consider that the change of some cartesian coordinate system $(*)$
to another cartesian coordinate system $(**)$ is given by
\begin{equation}\label{noninchgravortbstrjgghguittu2intrrrZZjkjjkjkjk}
\begin{cases}
\vec x'=A(t)\cdot\vec x+\vec z(t),\\
t'=t,
\end{cases}
\end{equation}
where $A(t)\in SO(3)$ is a rotation. Next, assume that in the
coordinate system $(**)$ we observe a validity of the Dirac
equations of the form:
\begin{multline}\label{vhfffngghkjgghfjjghghghSYShmyuuiiuuhmhmiopoopnniukjhjkk;l;lkkkJJjjkjk;lk;kl}
i\hbar\left(\frac{\partial\psi'_1}{\partial t'}+\frac{1}{2}div_{\vec
x'}\left\{\psi'_1\vec v'(\vec x',t')\right\}+\frac{1}{2}\nabla_{\vec
x'}\psi'_1\cdot\vec v'(\vec x',t')\right)=m'c^2\psi'_1-c\,\vec
S\cdot\left(i\hbar\nabla_{\vec x'}\psi'_2+\frac{\sigma'}{c}\vec
A'(\vec x',t)\psi'_2\right)
%
%
%
\\+\sigma'\left(\Psi'(\vec x',t')-\frac{1}{c}\vec v'(\vec x',t')\cdot\vec
A'(\vec x',t')\right)\psi'_1-V'\left(\vec
x',t'\right)\psi'_1+\frac{\hbar}{4}\vec S\cdot\left(curl_{\vec
x'}\vec v'(\vec x',t')\,\psi'_1\right)\\
-\frac{(g'-1)\sigma'\hbar}{2m'c}\vec S\cdot\left(curl_{\vec x'}\vec
A'(\vec x',t)\,\psi'_1\right),
\end{multline}
and
\begin{multline}\label{vhfffngghkjgghfjjghghghSYShmyuuiiuuhmhmiopoopnniukjhjkhhjl;kkJJkllkklkk}
i\hbar\left(\frac{\partial\psi'_2}{\partial t'}+\frac{1}{2}div_{\vec
x'}\left\{\psi'_2\vec v'(\vec x',t')\right\}+\frac{1}{2}\nabla_{\vec
x'}\psi'_2\cdot\vec v'(\vec x',t')\right)=\\-m'c^2\psi'_2-c\,\vec
S\cdot\left(i\hbar\nabla_{\vec x'}\psi'_1+\frac{\sigma'}{c}\vec
A'(\vec x',t')\psi'_1\right)
%
%
%
+\sigma'\left(\Psi'(\vec x',t')-\frac{1}{c}\vec v'(\vec
x',t')\cdot\vec A'(\vec x',t')\right)\psi'_2\\-V'\left(\vec
x',t'\right)\psi'_2+\frac{\hbar}{4}\vec S\cdot\left(curl_{\vec
x'}\vec v'(\vec x',t')\,\psi'_2\right)\\
-\frac{(g'-1)\sigma'\hbar}{2m'c}\vec S\cdot\left(curl_{\vec x'}\vec
A'(\vec x',t)\,\psi'_2\right),
\end{multline}
where $\psi=(\psi_1,\psi_2)\in\mathbb{C}^2\times \mathbb{C}^2$ is a
four-component wave function. Then in the coordinate system $(*)$ we
have the validity of Dirac equations of the same as
\er{vhfffngghkjgghfjjghghghSYShmyuuiiuuhmhmiopoopnniukjhjkk;l;lkkkJJjjkjk;lk;kl}
and
\er{vhfffngghkjgghfjjghghghSYShmyuuiiuuhmhmiopoopnniukjhjkhhjl;kkJJkllkklkk}
form:
\begin{multline}\label{vhfffngghkjgghfjjghghghSYShmyuuiiuuhmhmiopoopnniukjhjkk;l;lkkkJJjjkjkjkjk}
i\hbar\left(\frac{\partial\psi_1}{\partial t}+\frac{1}{2}div_{\vec
x}\left\{\psi_1\vec v(\vec x,t)\right\}+\frac{1}{2}\nabla_{\vec
x}\psi_1\cdot\vec v(\vec x,t)\right)=mc^2\psi_1-c\,\vec
S\cdot\left(i\hbar\nabla_{\vec x}\psi_2+\frac{\sigma}{c}\vec A(\vec
x,t)\psi_2\right)
%
%
%
\\+\sigma\left(\Psi(\vec x,t)-\frac{1}{c}\vec v(\vec x,t)\cdot\vec
A(\vec x,t)\right)\psi_1-V\left(\vec
x,t\right)\psi_1+\frac{\hbar}{4}\vec S\cdot\left(curl_{\vec x}\vec
v(\vec x,t)\,\psi_1\right)\\-\frac{(g-1)\sigma\hbar}{2mc}\vec
S\cdot\left(curl_{\vec x}\vec A(\vec x,t)\,\psi_1\right),
\end{multline}
and
\begin{multline}\label{vhfffngghkjgghfjjghghghSYShmyuuiiuuhmhmiopoopnniukjhjkhhjl;kkJJkllkkljjj}
i\hbar\left(\frac{\partial\psi_2}{\partial t}+\frac{1}{2}div_{\vec
x}\left\{\psi_2\vec v(\vec x,t)\right\}+\frac{1}{2}\nabla_{\vec
x}\psi_2\cdot\vec v(\vec x,t)\right)=-mc^2\psi_2-c\,\vec
S\cdot\left(i\hbar\nabla_{\vec x}\psi_1+\frac{\sigma}{c}\vec A(\vec
x,t)\psi_1\right)
%
%
%
\\+\sigma\left(\Psi(\vec x,t)-\frac{1}{c}\vec v(\vec x,t)\cdot\vec
A(\vec x,t)\right)\psi_2-V\left(\vec
x,t\right)\psi_2+\frac{\hbar}{4}\vec S\cdot\left(curl_{\vec x}\vec
v(\vec x,t)\,\psi_2\right)\\
-\frac{(g-1)\sigma\hbar}{2mc}\vec S\cdot\left(curl_{\vec x}\vec
A(\vec x,t)\,\psi_2\right),
\end{multline}
provided that
\begin{equation}\label{yuythfgfyftydtydtydtyddyyyhhddhhhredPPN111hgghjgintintrrZZkjhh}
\begin{cases}
V'=V,\\
\sigma'=\sigma,\\
m'=m,\\
\vec v'=A(t)\cdot \vec v+\frac{dA}{dt}(t)\cdot\vec x+\frac{d\vec z}{dt}(t),\\
\vec A'=A(t)\cdot \vec A,\\
\Psi'-\vec v'\cdot\vec A'=\Psi-\vec v\cdot\vec A,\\
\psi'_1=U(t)\cdot\psi_1,\\
\psi'_2=U(t)\cdot\psi_2,
\end{cases}
\end{equation}
where, as before, $U(t)\in SU(2)$ is characterized by:
\begin{equation}\label{gyfyfgfgfgghZZjkjk}
U^*(t)\cdot\vec S\cdot U(t)=A(t)\cdot\vec S,
\end{equation}
that means
\begin{multline*}
\left(U^*(t)\cdot S_1\cdot U(t),U^*(t)\cdot S_2\cdot
U(t),U^*(t)\cdot S_3\cdot U(t)\right)=\\
\left(a_{11}(t)S_1+a_{12}(t)S_2+a_{13}(t)S_3\,,\,a_{21}(t)S_1+a_{22}(t)S_2+a_{23}(t)S_3\,,\,a_{31}(t)S_1+a_{32}(t)S_2+a_{33}(t)S_3\right),
\end{multline*}
where $A(t)=\left\{a_{mk}(t)\right\}_{\{1\leq m,k\leq 3\}}$.
\end{theorem}

Next, in the case that our particle has a positive energy, define
$$(\phi_1,\phi_2)=\left(e^{-\frac{ic^2mt}{\hbar}}\psi_1,e^{-\frac{ic^2mt}{\hbar}}\psi_2\right).$$
Then we rewrite the Dirac equations
\er{vhfffngghkjgghfjjghghghSYShmyuuiiuuhmhmiopoopnniukjhjkk;l;lkkkJJ}
and
\er{vhfffngghkjgghfjjghghghSYShmyuuiiuuhmhmiopoopnniukjhjkhhjl;kkJJ}
as
\begin{multline}\label{vhfffngghkjgghfjjghghghSYShmyuuiiuuhmhmiopoopnniukjhjkk;l;lkhjjkkkJJ}
i\hbar\frac{\partial\phi_1}{\partial t}= -c\,\vec
S\cdot\left(i\hbar\nabla_{\vec x}\phi_2+\frac{\sigma}{c}\vec A(\vec
x,t)\phi_2\right)
%
%
%
+\sigma\left(\Psi(\vec x,t)-\frac{1}{c}\vec v(\vec x,t)\cdot\vec
A(\vec x,t)\right)\phi_1\\-V\left(\vec
x,t\right)\phi_1-\frac{i\hbar}{2}div_{\vec x}\left\{\phi_1\vec
v(\vec x,t)\right\}-\frac{i\hbar}{2}\nabla_{\vec x}\phi_1\cdot\vec
v(\vec x,t)+\frac{\hbar}{4}\vec S\cdot\left(curl_{\vec x}\vec v(\vec
x,t)\,\phi_1\right)\\-\frac{(g-1)\sigma\hbar}{2mc}\vec
S\cdot\left(curl_{\vec x}\vec A(\vec x,t)\,\phi_1\right),
\end{multline}
and
\begin{multline}\label{vhfffngghkjgghfjjghghghSYShmyuuiiuuhmhmiopoopnniukjhjkhhjl;jkljhkkJJ}
i\hbar\frac{\partial\phi_2}{\partial t}= -2mc^2\phi_2-c\,\vec
S\cdot\left(i\hbar\nabla_{\vec x}\phi_1+\frac{\sigma}{c}\vec A(\vec
x,t)\phi_1\right)
%
%
%
+\sigma\left(\Psi(\vec x,t)-\frac{1}{c}\vec v(\vec x,t)\cdot\vec
A(\vec x,t)\right)\phi_2\\-V\left(\vec
x,t\right)\phi_2-\frac{i\hbar}{2}div_{\vec x}\left\{\phi_2\vec
v(\vec x,t)\right\}-\frac{i\hbar}{2}\nabla_{\vec x}\phi_2\cdot\vec
v(\vec x,t)+\frac{\hbar}{4}\vec S\cdot\left(curl_{\vec x}\vec v(\vec
x,t)\,\phi_2\right)\\-\frac{(g-1)\sigma\hbar}{2mc}\vec
S\cdot\left(curl_{\vec x}\vec A(\vec x,t)\,\phi_2\right).
\end{multline}
Thus from
\er{vhfffngghkjgghfjjghghghSYShmyuuiiuuhmhmiopoopnniukjhjkhhjl;jkljhkkJJ},
in the non-relativistic limit we have,
\begin{equation}\label{vhfffngghkjgghfjjghghghSYShmyuuiiuuhmhmiopoopnniukjhjkhhjl;jkljhjjkllkjkkJJ}
2mc^2\phi_2\approx-c\,\vec S\cdot\left(i\hbar\nabla_{\vec
x}\phi_1+\frac{\sigma}{c}\vec A(\vec x,t)\phi_1\right).
%
%
%
\end{equation}
I.e.
\begin{equation}\label{vhfffngghkjgghfjjghghghSYShmyuuiiuuhmhmiopoopnniukjhjkhhjl;jkljhjjkllkjhykkJJ}
\phi_2\approx-\frac{1}{2c m}\vec S\cdot\left(i\hbar\nabla_{\vec
x}\phi_1+\frac{\sigma}{c}\vec A(\vec x,t)\phi_1\right).
%
%
%
\end{equation}
Thus inserting
\er{vhfffngghkjgghfjjghghghSYShmyuuiiuuhmhmiopoopnniukjhjkhhjl;jkljhjjkllkjhykkJJ}
into
\er{vhfffngghkjgghfjjghghghSYShmyuuiiuuhmhmiopoopnniukjhjkk;l;lkhjjkkkJJ}
gives
\begin{multline}\label{vhfffngghkjgghfjjghghghSYShmyuuiiuuhmhmiopoopnniukjhjkk;l;lkhjjkihjjhkkJJ}
i\hbar\frac{\partial\phi_1}{\partial t}\approx
\\
\frac{1}{2m}\vec S\cdot\left(i\hbar\nabla_{\vec x}\left(\vec
S\cdot\left(i\hbar\nabla_{\vec x}\phi_1+\frac{\sigma}{c}\vec A(\vec
x,t)\phi_1\right)\right)\right)+\frac{1}{2m}\vec
S\cdot\left(\frac{\sigma}{c}\vec A(\vec x,t)\left(\vec
S\cdot\left(i\hbar\nabla_{\vec x}\phi_1+\frac{\sigma}{c}\vec A(\vec
x,t)\phi_1\right)\right)\right)
%
%
%
\\+\sigma\left(\Psi(\vec x,t)-\frac{1}{c}\vec v(\vec x,t)\cdot\vec
A(\vec x,t)\right)\phi_1-V\left(\vec
x,t\right)\phi_1\\-\frac{i\hbar}{2}div_{\vec x}\left\{\phi_1\vec
v(\vec x,t)\right\}-\frac{i\hbar}{2}\nabla_{\vec x}\phi_1\cdot\vec
v(\vec x,t)+\frac{\hbar}{4}\vec S\cdot\left(curl_{\vec x}\vec v(\vec
x,t)\,\phi_1\right)-\frac{(g-1)\sigma\hbar}{2mc}\vec
S\cdot\left(curl_{\vec x}\vec A(\vec x,t)\,\phi_1\right),
\end{multline}
that we rewrite as a non-relativistic Shr\"{o}dinger-Pauli equation,
that we studied above:
\begin{multline}\label{vhfffngghkjgghfjjghghghSYShmyuuiiuuhmhmiopoopnniukjhjkk;l;lkhjjkihjjhkkJJllkkk}
i\hbar\frac{\partial\phi_1}{\partial
t}\approx-\frac{\hbar^2}{2m}\Delta_{\vec
x}\phi_1+\frac{i\hbar\sigma}{2mc}div_{\vec x}\left\{\phi_1\vec
A(\vec x,t)\right\}+\frac{i\hbar\sigma}{2mc}\nabla_{\vec
x}\phi_1\cdot\vec A(\vec x,t)+\frac{\sigma^2}{2mc^2}\left|\vec
A(\vec x,t)\right|^2\phi_1\\-\frac{g\sigma\hbar}{2mc}\vec
S\cdot\left(curl_{\vec x}\vec A(\vec x,t)\,\phi_1\right)
%
%
%
+\sigma\left(\Psi(\vec x,t)-\frac{1}{c}\vec v(\vec x,t)\cdot\vec
A(\vec x,t)\right)\phi_1-V\left(\vec
x,t\right)\phi_1\\-\frac{i\hbar}{2}div_{\vec x}\left\{\phi_1\vec
v(\vec x,t)\right\}-\frac{i\hbar}{2}\nabla_{\vec x}\phi_1\cdot\vec
v(\vec x,t)+\frac{\hbar}{4}\vec S\cdot\left(curl_{\vec x}\vec v(\vec
x,t)\,\phi_1\right)
.
\end{multline}

Next, consider a Lagrangian density $L$ associated the motion of a
spin-half quantum relativistic-like micro-particle with inertial
mass $m$ and charge $\sigma$ in the outer gravitational and
electromagnetical field with characteristics $\vec v(\vec x,t)$,
$\vec A(\vec x,t)$ and $\Psi(\vec x,t)$ and additional conservative
field with potential $V(\vec y,t)$:
\begin{multline}\label{vhfffngghkjgghDDmmkkkZZkkk}
L\left(\psi,\vec x,t\right):=
\frac{i\hbar}{2}\left(\left(\frac{\partial\psi_1}{\partial t}+\vec
v\cdot\nabla_{\vec
x}\psi_1\right)\cdot\bar\psi_1-\psi_1\cdot\left(\frac{\partial\bar\psi_1}{\partial
t}+\vec v\cdot\nabla_{\vec
x}\bar\psi_1\right)\right)\\+\frac{c}{2}\left(\left(\vec
S\cdot\left(i\hbar\nabla_{\vec x}\psi_2+\frac{\sigma}{c}\vec A(\vec
x,t)\psi_2\right)\right)\cdot\bar\psi_1-\psi_1\cdot\left(\vec
S\cdot\left(i\hbar\nabla_{\vec x}\bar\psi_2-\frac{\sigma}{c}\vec
A(\vec x,t)\bar\psi_2\right)\right)\right)\\+
\frac{i\hbar}{2}\left(\left(\frac{\partial\psi_2}{\partial t}+\vec
v\cdot\nabla_{\vec
x}\psi_2\right)\cdot\bar\psi_2-\psi_2\cdot\left(\frac{\partial\bar\psi_2}{\partial
t}+\vec v\cdot\nabla_{\vec
x}\bar\psi_2\right)\right)\\
+\frac{c}{2}\left(\left(\vec S\cdot\left(i\hbar\nabla_{\vec
x}\psi_1+\frac{\sigma}{c}\vec A(\vec
x,t)\psi_1\right)\right)\cdot\bar\psi_2-\psi_2\cdot\left(\vec
S\cdot\left(i\hbar\nabla_{\vec x}\bar\psi_1-\frac{\sigma}{c}\vec
A(\vec x,t)\bar\psi_1\right)\right) \right)
\\-mc^2\left(\psi_1\cdot\bar\psi_1-\psi_2\cdot\bar\psi_2\right)
-\sigma\left(\Psi-\frac{1}{c}\vec v\cdot\vec
A\right)\left(\psi_1\cdot\bar\psi_1+\psi_2\cdot\bar\psi_2\right)+V\left(\vec
x,t\right)\left(\psi_1\cdot\bar\psi_1+\psi_2\cdot\bar\psi_2\right)\\-\frac{\hbar}{4}\left(\vec
S\cdot\left(curl_{\vec x}\vec
v\right)\psi_1\right)\cdot\bar\psi_1-\frac{\hbar}{4}\left(\vec
S\cdot\left(curl_{\vec x}\vec
v\right)\psi_2\right)\cdot\bar\psi_2\\
+\frac{(g-1)\sigma\hbar}{2mc}\left(\vec S\cdot\left(curl_{\vec
x}\vec
A\right)\psi_1\right)\cdot\bar\psi_1+\frac{(g-1)\sigma\hbar}{2mc}\left(\vec
S\cdot\left(curl_{\vec x}\vec A\right)\psi_2\right)\cdot\bar\psi_2,
\end{multline}
%
%
%
%
%
%
where $\psi=(\psi_1,\psi_2)\in\mathbb{C}^2\times \mathbb{C}^2$ is a
four-component wave function. Then similarly to the proof of Theorem
\ref{gjghghgghgintintrrZZjkh} we can prove that $L$ is invariant
under the change of inertial or non-inertial cartesian coordinate
system, given by
\er{noninchgravortbstrjgghguittu2intrrrZZjkjjkjkjk}, provided that
we take into account
\er{yuythfgfyftydtydtydtyddyyyhhddhhhredPPN111hgghjgintintrrZZkjhh}.
We investigate stationary points of the functional
\begin{equation}\label{btfffygtgyggyDDmmkkkZZkkk}
J=\int_0^T\int_{\mathbb{R}^3}L\left(\psi,\vec x,t\right)d\vec x dt.
\end{equation}
Then, by \er{vhfffngghkjgghDDmmkkkZZkkk} we have
\begin{multline}\label{vhfffngghkjgghDDmmkkkghgZZkkk}
0=\frac{\delta L}{\delta (\bar\psi_1)}=
i\hbar\left(\frac{\partial\psi_1}{\partial t}+\frac{1}{2}\vec
v\cdot\nabla_{\vec x}\psi_1+\frac{1}{2}div_{\vec x}\left\{\psi_1\vec
v\right\}\right)+c\,\vec S\cdot\left(i\hbar\nabla_{\vec
x}\psi_2+\frac{\sigma}{c}\vec A(\vec x,t)\psi_2\right)-mc^2\psi_1
\\-\sigma\left(\Psi-\frac{1}{c}\vec v\cdot\vec A\right)\psi_1
-\frac{\hbar}{4}\,\vec S\cdot\left(curl_{\vec x}\vec
v\right)\psi_1+V(\vec x,t)\psi_1+\frac{(g-1)\sigma\hbar}{2mc},\vec
S\cdot\left(curl_{\vec x}\vec A\right)\psi_1+V(\vec x,t)\psi_1,
\end{multline}
\begin{multline}\label{vhfffngghkjgghDDmmkkkghgZZkkkhh}
0=\frac{\delta L}{\delta (\bar\psi_2)}=
i\hbar\left(\frac{\partial\psi_2}{\partial t}+\frac{1}{2}\vec
v\cdot\nabla_{\vec x}\psi_2+\frac{1}{2}div_{\vec x}\left\{\psi_2\vec
v\right\}\right)+c\,\vec S\cdot\left(i\hbar\nabla_{\vec
x}\psi_1+\frac{\sigma}{c}\vec A(\vec x,t)\psi_1\right)+mc^2\psi_2
\\-\sigma\left(\Psi-\frac{1}{c}\vec v\cdot\vec A\right)\psi_2
-\frac{\hbar}{4}\,\vec S\cdot\left(curl_{\vec x}\vec
v\right)\psi_1+V(\vec x,t)\psi_2+\frac{(g-1)\sigma\hbar}{2mc},\vec
S\cdot\left(curl_{\vec x}\vec A\right)\psi_2,
\end{multline}
\begin{multline}\label{vhfffngghkjgghDDmmkkkghguhyuyZZkkk}
0=\frac{\delta L}{\delta (\psi_1)}= \bar
{(i)}\hbar\left(\frac{\partial\bar\psi_1}{\partial
t}+\frac{1}{2}\vec v\cdot\nabla_{\vec
x}\bar\psi_1+\frac{1}{2}div_{\vec x}\left\{\bar\psi_1\vec
v\right\}\right)+c\,\vec S\cdot\left(\bar {(i)}\hbar\nabla_{\vec
x}\bar\psi_2+\frac{\sigma}{c}\vec A(\vec x,t)\bar\psi_2\right)
-mc^2\bar\psi_1
\\-\sigma\left(\Psi-\frac{1}{c}\vec v\cdot\vec A\right)\bar\psi_1
-\frac{\hbar}{4}\,\vec S^T\cdot\left(curl_{\vec x}\vec
v\right)\bar\psi_1+V(\vec
x,t)\bar\psi_1+\frac{(g-1)\sigma\hbar}{2mc},\vec
S^T\cdot\left(curl_{\vec x}\vec A\right)\bar\psi_1
\end{multline}
and
\begin{multline}\label{vhfffngghkjgghDDmmkkkghguhyuyZZkkkhh}
0=\frac{\delta L}{\delta (\psi_2)}= \bar
{(i)}\hbar\left(\frac{\partial\bar\psi_2}{\partial
t}+\frac{1}{2}\vec v\cdot\nabla_{\vec
x}\bar\psi_2+\frac{1}{2}div_{\vec x}\left\{\bar\psi_2\vec
v\right\}\right)+c\,\vec S\cdot\left(\bar {(i)}\hbar\nabla_{\vec
x}\bar\psi_1+\frac{\sigma}{c}\vec A(\vec x,t)\bar\psi_1\right)
+mc^2\bar\psi_2
\\-\sigma\left(\Psi-\frac{1}{c}\vec v\cdot\vec A\right)\bar\psi_2
-\frac{\hbar}{4}\,\vec S^T\cdot\left(curl_{\vec x}\vec
v\right)\bar\psi_2+V(\vec
x,t)\bar\psi_2+\frac{(g-1)\sigma\hbar}{2mc},\vec
S^T\cdot\left(curl_{\vec x}\vec A\right)\bar\psi_2.
\end{multline}
Note that \er{vhfffngghkjgghDDmmkkkghguhyuyZZkkk} and
\er{vhfffngghkjgghDDmmkkkghguhyuyZZkkkhh} are just the complex
conjugates of \er{vhfffngghkjgghDDmmkkkghgZZkkk} and
\er{vhfffngghkjgghDDmmkkkghgZZkkkhh}. So we get that the
Euler-Lagranges equation for \er{vhfffngghkjgghDDmmkkkZZkkk}
coincide with Dirac equations in the form of
\er{vhfffngghkjgghfjjghghghSYShmyuuiiuuhmhmiopoopnniukjhjkk;l;lkkkJJjjkjk}
and
\er{vhfffngghkjgghfjjghghghSYShmyuuiiuuhmhmiopoopnniukjhjkhhjl;kkJJkllkkl}.

\subsection{Classical Relativistic-like Lagrangian and Hamiltonian
of the motion of the system of $n$ particles}\label{hgggggjjhhk}
As
in
\er{btfffygtgyggyijhhkkSYSPNuhuygyygyggyuyyuyuykuhghgjjojiyyyuyuuyy}
and
\er{btfffygtgyggyijhhkkSYSPNuhuygyygyggyuyyuyuykuhghgjjojiyyyuyuuyyKK}
consider the relativistic-like Lagrangian of the motion of $n$
relativistic-like particles with inertial masses $m_1,\ldots,m_n$
and the charges $\sigma_1,\ldots,\sigma_n$ in the outer
gravitational and electromagnetical field with characteristics $\vec
v(\vec x,t)$, $\vec A(\vec x,t)$ and $\Psi(\vec x,t)$ and additional
conservative field with potential $V(\vec y_1,\ldots,\vec y_n,t)$:
\begin{multline}\label{btfffygtgyggyijhhkkSYSPNuhuygyygyggyuyyuyuykuhghgjjojiyyyuyuuyyKKkknew}
J_{rl}(\vec r_1,\ldots,\vec r_n)=\int\limits_0^T
L_0\left(\frac{d\vec
r_1}{dt},\ldots, \frac{d\vec r_n}{dt},\vec r_1,\ldots, \vec r_n,t\right)dt:=\\
\int\limits_0^T\left\{
-\sum_{j=1}^{n}m_jc^2\sqrt{1-\frac{1}{c^2}\left|\frac{d\vec
r_j}{dt}-\vec v(\vec
r_j,t)\right|^2}-\sum_{j=1}^{n}\sigma_j\left(\Psi(\vec
r_j,t)-\frac{1}{c}\vec A(\vec r_j,t)\cdot\frac{d\vec
r_j}{dt}\right)+V\left(\vec r_1,\ldots, \vec r_n,t\right)\right\}dt.
\end{multline}
Next define the generalized momentums of each particle by
\begin{multline}\label{guytyurtydftyiujhSYShmhmKKkknew}
\vec P_j:=\nabla_{\vec r'_j}L_0\left(\vec r'_1,\ldots, \vec
r'_n,\vec r_1,\ldots,\vec
r_n,t\right)\\=m_j\left(1-\frac{1}{c^2}\left|\frac{d\vec
r_j}{dt}-\vec v(\vec
r_j,t)\right|^2\right)^{-\frac{1}{2}}\left(\frac{d\vec r_j}{dt}-\vec
v(\vec r_j,t)\right)+\frac{\sigma_j}{c}\vec A(\vec r_j,t).
\end{multline}
Then
\begin{equation}\label{guytyurtydftyiujhioyhuyhSYShmhmKKkknew}
\left(1-\frac{1}{c^2}\left|\frac{d\vec r_j}{dt}-\vec v(\vec
r_j,t)\right|^2\right)^{-\frac{1}{2}}\left(\frac{d\vec r_j}{dt}-\vec
v(\vec r_j,t)\right)=\left(\frac{1}{m_j}\vec
P_j-\frac{\sigma_j}{m_jc}\vec A(\vec r_j,t)\right).
\end{equation}
So
\begin{equation}\label{guytyurtydftyiujhioyhuyhSYShmhmuiiuiuKKkknew}
\frac{d\vec r_j}{dt}=\left(1+\frac{1}{c^2}\left|\frac{1}{m_j}\vec
P_j-\frac{\sigma_j}{m_jc}\vec A(\vec
r_j,t)\right|^2\right)^{-\frac{1}{2}}\left(\frac{1}{m_j}\vec
P_j-\frac{\sigma_j}{m_jc}\vec A(\vec r_j,t)\right)+\vec v(\vec
r_j,t).
\end{equation}
Thus if we consider a Hamiltonian
\begin{equation}\label{vhfffngghkjgghfjjhyjjfgSYShmhmKKkknew}
H_0\left(\vec P_1,\ldots, \vec P_n,\vec r_1,\ldots, \vec
r_n,t\right):=\sum_{j=1}^{n}\vec P_j\cdot\frac{d\vec
r_j}{dt}-L_0\left(\frac{d\vec r_1}{dt},\ldots, \frac{d\vec
r_n}{dt},\vec r_1,\ldots, \vec r_n,t\right),
\end{equation}
then, as before in \er{vhfffngghkjgghfjjghghghSYShmyuuiiuuhmhmKK}
and \er{vhfffngghkjgghfjjghghghSYSPN} by
\er{vhfffngghkjgghfjjhyjjfgSYShmhmKKkknew},
\er{btfffygtgyggyijhhkkSYSPNuhuygyygyggyuyyuyuykuhghgjjojiyyyuyuuyyKKkknew},
\er{guytyurtydftyiujhioyhuyhSYShmhmKKkknew} and
\er{guytyurtydftyiujhioyhuyhSYShmhmuiiuiuKKkknew} we
%
%
%
%
%
%
obtain that the relativistic-like Hamiltonian for a macro-particles
has the form:
\begin{multline}\label{vhfffngghkjgghfjjghghghSYShmyuuiiuuhmhmiopoopKKkknew}
H_0\left(\vec P_1,\ldots, \vec P_n,\vec r_1,\ldots, \vec
r_n,t\right)=
\sum_{j=1}^{n}m_jc^2\left(1+\frac{1}{c^2}\left|\frac{1}{m_j}\vec
P_j-\frac{\sigma_j}{m_jc}\vec A(\vec
r_j,t)\right|^2\right)^{\frac{1}{2}}
\\+\sum_{j=1}^{n}\sigma_j\left(\Psi(\vec r_j,t)-\frac{1}{c}\vec
v(\vec r_j,t)\cdot\vec A(\vec r_j,t)\right)-V\left(\vec r_1,\ldots,
\vec r_n,t\right)+\sum_{j=1}^{n}\vec v(\vec r_j,t)\cdot\vec P_j.
\end{multline}

\subsection{Liouville's equation for a system of $n$ relativistic-like classical particles}\label{gghfhfhfhfh11}
Assume that the number of particles $n$ in the system, ruled by the
Hamiltonian
\er{vhfffngghkjgghfjjghghghSYShmyuuiiuuhmhmiopoopKKkknew}, is large
and we describe this system statistically. Then let $w\left(\vec
P_1,\ldots,\vec P_n,\vec r_1,\ldots,\vec r_n,t\right)\to[0,+\infty)$
be the probability density of the system which satisfies the well
known classical Liouville's equation of the form:
\begin{multline}\label{vhfffngghkjgghfjjghghghhjghjgghkghggSYSPNkljljjmjkj11}
\frac{\partial w}{\partial t}+\sum_{j=1}^{n}\left(div_{\vec
r_j}\left\{w\,\nabla_{\vec
P_j}H_0\right\}-div_{\vec P_j}\left\{w\,\nabla_{\vec r_j}H_0\right\}\right)=\\
\frac{\partial w}{\partial t}+\sum_{j=1}^{n}\left(\nabla_{\vec
P_j}H_0\cdot\nabla_{\vec r_j}w-\nabla_{\vec r_j}H_0\cdot\nabla_{\vec
P_j}w\right)=0.
\end{multline}
Then since by
\er{vhfffngghkjgghfjjghghghSYShmyuuiiuuhmhmiopoopKKkknew} we have
\begin{multline}\label{vhfffngghkjgghfjjghghghSYSPNjljl11}
H_0\left(\vec P_1,\ldots,\vec P_n,\vec r_1,\ldots,\vec
r_n,t\right)=\sum_{j=1}^{n}\vec P_j\cdot\vec v(\vec
r_j,t)+\sum_{j=1}^{n}m_jc^2\left(1+\frac{1}{c^2}\left|\frac{1}{m_j}\vec
P_j-\frac{\sigma_j}{m_jc}\vec A(\vec
r_j,t)\right|^2\right)^{\frac{1}{2}}\\
+\sum_{j=1}^{n}\sigma_j\left(\Psi(\vec r_j,t)-\frac{1}{c}\vec A(\vec
r_j,t)\cdot\vec v(\vec r_j,t)\right)-V\left(\vec r_1,\ldots,\vec
r_n,t\right),
\end{multline}
and in particular,
\begin{multline}\label{vhfffngghkjgghfjjghghghhjghjgghkghggSYSPNkljljjmjkjjjkkjlkjkljljjkjkkjjkl11}
\nabla_{\vec
P_j}H_0=\frac{1}{m_j}\left(1+\frac{1}{c^2}\left|\frac{1}{m_j}\vec
P_j-\frac{\sigma_j}{m_jc}\vec A(\vec
r_j,t)\right|^2\right)^{-\frac{1}{2}}\left(\vec
P_j-\frac{\sigma_j}{c}\vec A(\vec r_j,t)\right)+\vec v( \vec
r_j,t)\quad\text{and}\\ \nabla_{\vec r_j}H_0= \left\{d_{\vec x}\vec
v(\vec r_j,t)\right\}^T\cdot\vec P_j\\-\frac{\sigma_j}{c
m_j}\left(1+\frac{1}{c^2}\left|\frac{1}{m_j}\vec
P_j-\frac{\sigma_j}{m_jc}\vec A(\vec
r_j,t)\right|^2\right)^{-\frac{1}{2}}\left\{d_{\vec x}\vec A(\vec
r_j,t)\right\}^T\cdot\left(\vec P_j-\frac{\sigma_j}{c}\vec A(\vec
r_j,t)\right)\\+\sigma_j\nabla_{\vec x}\left(\Psi(\vec
r_j,t)-\frac{1}{c}\vec A(\vec r_j,t)\cdot\vec v(\vec
r_j,t)\right)-\nabla_{\vec y_j}V\left(\vec r_1,\ldots,\vec
r_n,t\right)\quad\quad\forall j=1,\ldots n,
\end{multline}
inserting
\er{vhfffngghkjgghfjjghghghhjghjgghkghggSYSPNkljljjmjkjjjkkjlkjkljljjkjkkjjkl11}
into \er{vhfffngghkjgghfjjghghghhjghjgghkghggSYSPNkljljjmjkj11}
gives
\begin{multline}\label{vhfffngghkjgghfjjghghghhjghjgghkghggSYSPNkljljjmjkjhhhjhigjgj11}
0=\frac{\partial w}{\partial t}+\sum_{j=1}^{n}\nabla_{\vec
P_j}H_0\cdot\nabla_{\vec r_j}w-\sum_{j=1}^{n}\nabla_{\vec
r_j}H_0\cdot\nabla_{\vec P_j}w= \frac{\partial w}{\partial t}
+\sum_{j=1}^{n}\vec v( \vec r_j,t)\cdot\nabla_{\vec r_j}w\\
+\sum_{j=1}^{n}\frac{1}{m_j}\left(1+\frac{1}{c^2}\left|\frac{1}{m_j}\vec
P_j-\frac{\sigma_j}{m_jc}\vec A(\vec
r_j,t)\right|^2\right)^{-\frac{1}{2}}\left(\vec
P_j-\frac{\sigma_j}{c}\vec A(\vec r_j,t)\right)\cdot\nabla_{\vec
r_j}w -\sum_{j=1}^{n}\left(\left\{d_{\vec x}\vec v(\vec
r_j,t)\right\}^T\cdot\vec P_j\right)\cdot\nabla_{\vec
P_j}w\\
+ \sum_{j=1}^{n}\left(\frac{\sigma_j}{c
m_j}\left(1+\frac{1}{c^2}\left|\frac{1}{m_j}\vec
P_j-\frac{\sigma_j}{m_jc}\vec A(\vec
r_j,t)\right|^2\right)^{-\frac{1}{2}}\left\{d_{\vec x}\vec A(\vec
r_j,t)\right\}^T\cdot\left(\vec P_j-\frac{\sigma_j}{c}\vec A(\vec
r_j,t)\right)\right)\cdot\nabla_{\vec P_j}w
\\
-\sum_{j=1}^{n}\left(\sigma_j\nabla_{\vec x}\left(\Psi(\vec
r_j,t)-\frac{1}{c}\vec A(\vec r_j,t)\cdot\vec v(\vec
r_j,t)\right)-\nabla_{\vec y_j}V\left(\vec r_1,\ldots,\vec
r_n,t\right)\right)\cdot\nabla_{\vec P_j}w = \frac{\partial
w}{\partial t} +\sum_{j=1}^{n}\vec v( \vec r_j,t)\cdot\nabla_{\vec
r_j}w
\\+\sum_{j=1}^{n}\frac{1}{m_j}\left(1+\frac{1}{c^2}\left|\frac{1}{m_j}\vec
P_j-\frac{\sigma_j}{m_jc}\vec A(\vec
r_j,t)\right|^2\right)^{-\frac{1}{2}}\left(\vec
P_j-\frac{\sigma_j}{c}\vec A(\vec r_j,t)\right)\cdot\nabla_{\vec
r_j}w \\+\sum_{j=1}^{n}\frac{1}{2}\left(\left(d_{\vec x}\vec v(\vec
r_j,t)-\left\{d_{\vec x}\vec v(\vec r_j,t)\right\}^T\right)\cdot\vec
P_j\right)\cdot\nabla_{\vec
P_j}w-\sum_{j=1}^{n}\frac{1}{2}\left(\left(d_{\vec x}\vec v(\vec
r_j,t)+\left\{d_{\vec x}\vec v(\vec r_j,t)\right\}^T\right)\cdot\vec
P_j\right)\cdot\nabla_{\vec
P_j}w\\
+ \sum_{j=1}^{n}\left(\frac{\sigma_j}{c
m_j}\left(1+\frac{1}{c^2}\left|\frac{1}{m_j}\vec
P_j-\frac{\sigma_j}{m_jc}\vec A(\vec
r_j,t)\right|^2\right)^{-\frac{1}{2}}\left\{d_{\vec x}\vec A(\vec
r_j,t)\right\}^T\cdot\left(\vec P_j-\frac{\sigma_j}{c}\vec A(\vec
r_j,t)\right)\right)\cdot\nabla_{\vec P_j}w
\\
-\sum_{j=1}^{n}\left(\sigma_j\nabla_{\vec x}\left(\Psi(\vec
r_j,t)-\frac{1}{c}\vec A(\vec r_j,t)\cdot\vec v(\vec
r_j,t)\right)-\nabla_{\vec y_j}V\left(\vec r_1,\ldots,\vec
r_n,t\right)\right)\cdot\nabla_{\vec P_j}w.
\end{multline}
Thus, by
\er{vhfffngghkjgghfjjghghghhjghjgghkghggSYSPNkljljjmjkjhhhjhigjgj11},
using \er{apfrm9}, we rewrite the Liouville's equation as:
\begin{multline}\label{vhfffngghkjgghfjjghghghhjghjgghkghggSYSPNkljljjmjkjhhhjhigjgjjhjgjg11}
\frac{\partial w}{\partial t} +\sum_{j=1}^{n}\vec v( \vec
r_j,t)\cdot\nabla_{\vec r_j}w
+\sum_{j=1}^{n}\frac{1}{2}\left(\left(curl_{\vec x}\vec v(\vec
r_j,t)\right)\times\vec P_j\right)\cdot\nabla_{\vec P_j}w
\\-\sum_{j=1}^{n}\frac{1}{2}\left(\left(d_{\vec x}\vec v(\vec
r_j,t)+\left\{d_{\vec x}\vec v(\vec r_j,t)\right\}^T\right)\cdot\vec
P_j\right)\cdot\nabla_{\vec P_j}w
\\+\sum_{j=1}^{n}\frac{1}{m_j}\left(1+\frac{1}{c^2}\left|\frac{1}{m_j}\vec
P_j-\frac{\sigma_j}{m_jc}\vec A(\vec
r_j,t)\right|^2\right)^{-\frac{1}{2}}\left(\vec
P_j-\frac{\sigma_j}{c}\vec A(\vec r_j,t)\right)\cdot\nabla_{\vec
r_j}w
\\+
\sum_{j=1}^{n}\left(\frac{\sigma_j}{c
m_j}\left(1+\frac{1}{c^2}\left|\frac{1}{m_j}\vec
P_j-\frac{\sigma_j}{m_jc}\vec A(\vec
r_j,t)\right|^2\right)^{-\frac{1}{2}}\left\{d_{\vec x}\vec A(\vec
r_j,t)\right\}^T\cdot\left(\vec P_j-\frac{\sigma_j}{c}\vec A(\vec
r_j,t)\right)\right)\cdot\nabla_{\vec P_j}w
\\
-\sum_{j=1}^{n}\left(\sigma_j\nabla_{\vec x}\left(\Psi(\vec
r_j,t)-\frac{1}{c}\vec A(\vec r_j,t)\cdot\vec v(\vec
r_j,t)\right)-\nabla_{\vec y_j}V\left(\vec r_1,\ldots,\vec
r_n,t\right)\right)\cdot\nabla_{\vec P_j}w=0.
\end{multline}
Next if the change of some non-inertial cartesian coordinate system
$(*)$ to another cartesian coordinate system $(**)$ is of the form
\er{noninchgravortbstrjgghguittu2}:
\begin{equation}\label{noninchgravortbstrjgghguittu2jjjk11}
\begin{cases}
\vec x'=A(t)\cdot\vec x+\vec z(t),\\
t'=t,
\end{cases}
\end{equation}
where $A(t)\in SO(3)$ is a rotation, then consistently with
\er{noninchgravortbstrjgghguittu2jjjk11} and
\er{guytyurtydftyiujhSYShmhmKKkknew} we have the following change of
variables $(\vec P_1,\ldots,\vec P_n,\vec x_1,\ldots,\vec
x_n,t)\to(\vec P'_1,\ldots,\vec P'_n,\vec x'_1,\ldots,\vec
x'_n,t')$:
\begin{equation}\label{noninchgravortbstrjgghguittu2intrrrZZHHjkjhjh11}
\begin{cases}
t'=t,\\
\vec x'_k=A(t)\cdot\vec x_k+\vec z(t)\quad\quad\forall
k=1,\ldots,n,\\
\vec P'_k=A(t)\cdot\vec P_k\quad\quad\forall k=1,\ldots,n.
\end{cases}
\end{equation}
Thus, since consistently with
\er{noninchgravortbstrjgghguittu2jjjk11} we have
\begin{equation}\label{yuythfgfyftydtydtydtyddyyyhhddhhhredPPN111hgghjgintintrrZZHHlll11}
\begin{cases}
V'\left(\vec x'_1,\ldots,\vec x'_n,t'\right)\,=\,V\left(\vec
x_1,\ldots,\vec x_n,t\right),\\
\sigma'_j\,=\,\sigma_j,\\
m'_j\,=\,m_j,\\
\vec v'(\vec x',t)\,=\,A(t)\cdot \vec v(\vec x,t)+\frac{dA}{dt}(t)\cdot\vec x+\frac{d\vec z}{dt}(t),\\
\vec A'(\vec x',t)\,=\,A(t)\cdot \vec A(\vec x,t),\\
\Psi'(\vec x',t)-\vec v'(\vec x',t)\cdot\vec A'(\vec
x',t)\,=\,\Psi(\vec x,t)-\vec v(\vec x,t)\cdot\vec A(\vec x,t),
\end{cases}
\end{equation}
by \er{noninchgravortbstrjgghguittu2intrrrZZHHjkjhjh11} and
\er{yuythfgfyftydtydtydtyddyyyhhddhhhredPPN111hgghjgintintrrZZHHlll11}
we deduce that the Liouville equation
\er{vhfffngghkjgghfjjghghghhjghjgghkghggSYSPNkljljjmjkjhhhjhigjgjjhjgjg11}
is invariant under the the change of non-inertial cartesian
coordinate system of the form
\er{noninchgravortbstrjgghguittu2jjjk11}.

\subsubsection{Thermodynamical equilibrium in the relativistic-like case}\label{hjgutgyutyytrr} Next let $w\left(\vec
P_1,\ldots,\vec P_n,\vec r_1,\ldots,\vec r_n,t\right)\to[0,+\infty)$
be the probability density of the system, ruled by the Hamiltonian
$H_0$ from \er{vhfffngghkjgghfjjghghghSYSPNjljl11}, in the case of
\underline{thermodynamical equilibrium}. Then in the case that
$\frac{\partial H_0}{\partial t}\equiv 0$ and the given equilibrium
system rests macroscopically, i.e. it has the macroscopical velocity
field zero: $\vec u(\vec x,t)\equiv 0$ it is well known that
\begin{equation}\label{vhfffngghkjgghfjjghghghSYSPNjljllkljjhkhkhklpl;lpoopkklkljlrr}
w\left(\vec P_1,\ldots,\vec P_n,\vec r_1,\ldots,\vec
r_n,t\right):=\frac{1}{K(f,H_{0})}f\left(H_{0}\left(\vec
P_1,\ldots,\vec P_n,\vec r_1,\ldots,\vec r_n,t\right)\right),
\end{equation}
with
\begin{equation}\label{vhfffngghkjgghfjjghghghSYSPNjljllkljjhkhkhklpl;liuiouiojjkrr}
K(f,H_{0}):=\int f\left(H_{0}\left(\vec P_1,\ldots,\vec P_n,\vec
r_1,\ldots,\vec r_n,t\right)\right)d\vec P_1\ldots d\vec P_nd\vec
r_1\ldots d\vec r_n,
\end{equation}
and,
\begin{itemize}
\item in the case of a system in thermostat, having the Kelvin temperature $T$, we have the canonical ensemble: $f(s)=e^{-\frac{s}{kT}}\;\,\forall s$, where $k$ is the Boltzmann
constant,
\item in the case of a thermally isolated system with the average internal energy $E$ we have the micro-canonical ensemble: $f(s)=\delta(s-E)\;\,\forall
s$.
\end{itemize}
We would like to find alternative forms of the above laws of
thermodynamical equilibrium, which are invariant under the change of
inertial or non-inertial cartesian coordinate system. Then it is
clear that if the given system rests in the old coordinate system,
then it obviously has non-trivial macroscopical velocity field $\vec
u(\vec x,t)$ in the new one. Moreover, $\vec u(\vec x,t)$ can depend
on $\vec x$ and $t$, as it indeed happens in the case of a rotation
of the new coordinate system with respect to the old one. On the
other hand the concept of thermodynamical equilibrium is clearly
independent of the coordinate system.

In order to find the forms of the laws of thermodynamical
equilibrium, which are indeed invariant under the change of inertial
or non-inertial cartesian coordinate system we follow the steps as
below. By \er{vhfffngghkjgghfjjghghghSYSPNjljl11} the Hamiltonian
$H_0$ of our system has the following form:
\begin{multline}\label{vhfffngghkjgghfjjghghghSYSPNjljll;kl;rr}
H_0\left(\vec P_1,\ldots,\vec P_n,\vec r_1,\ldots,\vec
r_n,t\right)=\sum_{j=1}^{n}\vec P_j\cdot\vec v(\vec
r_j,t)+\sum_{j=1}^{n}m_jc^2\left(1+\frac{1}{c^2}\left|\frac{1}{m_j}\vec
P_j-\frac{\sigma_j}{m_jc}\vec A(\vec
r_j,t)\right|^2\right)^{\frac{1}{2}}\\
+\sum_{j=1}^{n}\sigma_j\left(\Psi(\vec r_j,t)-\frac{1}{c}\vec A(\vec
r_j,t)\cdot\vec v(\vec r_j,t)\right)-V\left(\vec r_1,\ldots,\vec
r_n,t\right),
\end{multline}
Next define an invariant quantity $H_{\vec u}$ as:
\begin{multline}\label{vhfffngghkjgghfjjghghghSYSPNjljll'lrr}
H_{\vec u}\left(\vec P_1,\ldots,\vec P_n,\vec r_1,\ldots,\vec
r_n,t\right):=H_0\left(\vec P_1,\ldots,\vec P_n,\vec r_1,\ldots,\vec
r_n,t\right)-\sum_{j=1}^{n}\vec P_j\cdot\vec u(\vec r_j,t)\\=
\sum_{j=1}^{n}\vec P_j\cdot\left(\vec v(\vec r_j,t)-\vec u(\vec
r_j,t)\right)+
\sum_{j=1}^{n}m_jc^2\left(1+\frac{1}{c^2}\left|\frac{1}{m_j}\vec
P_j-\frac{\sigma_j}{m_jc}\vec A(\vec
r_j,t)\right|^2\right)^{\frac{1}{2}}\\+\sum_{j=1}^{n}\sigma_j\left(\Psi(\vec
r_j,t)-\frac{1}{c}\vec A(\vec r_j,t)\cdot\vec v(\vec
r_j,t)\right)-V\left(\vec r_1,\ldots,\vec r_n,t\right),
\end{multline}
where $\vec u(\vec x,t)$ is the macroscopical velocity field of the
given system of particles. Then, as before, it can be easily deduced
that under the change of some non-inertial cartesian coordinate
system $(*)$ to another cartesian coordinate system $(**)$ of the
form \er{noninchgravortbstrjgghguittu2jjjk11} the quantity $H_{\vec
u}$ transforms as:
\begin{equation}\label{vhfffngghkjgghfjjghghghSYSPNjljllkljjhkhkhklpl;lpoopkklkljljkhhjrr}
H'_{\vec u'}=H_{\vec u},
\end{equation}
provided that $\vec u$ is the speed-like vector field i.e. under the
above change we have:
\begin{equation}\label{yuythfgfyftydtydtydtyddyyyhhddhhhredPPN111hgghjgintintrrZZHHlllkljjjkrr}
\vec u'(\vec x',t)\,=\,A(t)\cdot \vec u(\vec
x,t)+\frac{dA}{dt}(t)\cdot\vec x+\frac{d\vec z}{dt}(t),
\end{equation}
and we have \er{noninchgravortbstrjgghguittu2intrrrZZHHjkjhjh11} and
\er{yuythfgfyftydtydtydtyddyyyhhddhhhredPPN111hgghjgintintrrZZHHlll11}.
Next consider the canonical and micro-canonical ensembles as:
\begin{equation}\label{vhfffngghkjgghfjjghghghSYSPNjljllkljjhkhkhklpl;lpooprr}
w\left(\vec P_1,\ldots,\vec P_n,\vec r_1,\ldots,\vec
r_n,t\right):=\frac{1}{K(f,H_{\vec u})}f\left(H_{\vec u}\left(\vec
P_1,\ldots,\vec P_n,\vec r_1,\ldots,\vec r_n,t\right)\right),
\end{equation}
where as before,
\begin{equation}\label{vhfffngghkjgghfjjghghghSYSPNjljllkljjhkhkhklpl;liuiouiorr}
K(f,H_{\vec u}):=\int f\left(H_{\vec u}\left(\vec P_1,\ldots,\vec
P_n,\vec r_1,\ldots,\vec r_n,t\right)\right)d\vec P_1\ldots d\vec
P_nd\vec r_1\ldots d\vec r_n,
\end{equation}
and,
\begin{itemize}
\item in the case of a system in thermostat, having the Kelvin temperature $T$, we have: $f(s)=e^{-\frac{s}{kT}}\;\,\forall
s$,
\item in the case of a thermally isolated system with the average internal energy $E$ we have: $f(s)=\delta(s-E)\;\,\forall
s$.
\end{itemize}
Note here, that by \er{vhfffngghkjgghfjjghghghSYSPNjljll'lrr} for
the finite integrability in
\er{vhfffngghkjgghfjjghghghSYSPNjljllkljjhkhkhklpl;liuiouiorr} the
vector field $\vec u(\vec x,t)$ must satisfy
\begin{equation}\label{btfffygtgyggyijhhkkSYSPNuhuygyygyggyuyyuyuykuhghgjjojiyyyuyuuyyijyyuyuhghghguy}
\left|\vec u(\vec x,t)-\vec v(\vec
x,t)\right|\,<\,c\quad\quad\quad\quad\forall\, (\vec x,t).
\end{equation}
Next, by
\er{vhfffngghkjgghfjjghghghSYSPNjljllkljjhkhkhklpl;lpoopkklkljljkhhjrr}
we deduce that under the change of some non-inertial cartesian
coordinate system $(*)$ to another cartesian coordinate system
$(**)$ of the form \er{noninchgravortbstrjgghguittu2jjjk11} the
quantity $w$ transforms as:
\begin{equation}\label{vhfffngghkjgghfjjghghghSYSPNjljllkljjhkhkhklpl;lpoopkklkljljkhhjkhkhkrr}
w'=w.
\end{equation}
Moreover, in the case $\vec u\equiv 0$
\er{vhfffngghkjgghfjjghghghSYSPNjljllkljjhkhkhklpl;lpooprr}
coincides with
\er{vhfffngghkjgghfjjghghghSYSPNjljllkljjhkhkhklpl;lpoopkklkljlrr}.
However, we still need to derive the restrictions on the field $\vec
u$ and the Hamiltonian $H_0$, providing that our system can indeed
be found in the state of thermodynamical equilibrium. We remind that
in the case $\vec u\equiv 0$ the appropriate restriction is
$\frac{\partial H_0}{\partial t}\equiv 0$. In order to get these
restrictions in the general case, we need to insert $w$ in
\er{vhfffngghkjgghfjjghghghSYSPNjljllkljjhkhkhklpl;lpooprr} into the
Liouville's equation in
\er{vhfffngghkjgghfjjghghghhjghjgghkghggSYSPNkljljjmjkjhhhjhigjgj11}
having the form:
\begin{equation}\label{vhfffngghkjgghfjjghghghhjghjgghkghggSYSPNkljljjmjkjkjjkrr}
\frac{\partial w}{\partial t}+\sum_{j=1}^{n}\left(\nabla_{\vec
P_j}H_0\cdot\nabla_{\vec r_j}w-\nabla_{\vec r_j}H_0\cdot\nabla_{\vec
P_j}w\right)=0.
\end{equation}
By \er{vhfffngghkjgghfjjghghghSYSPNjljll'lrr} we have:
\begin{multline}\label{vhfffngghkjgghfjjghghghhjghjgghkghggSYSPNkljljjmjkjjjkkjlkjkljljjkjkkjjklkjjkjhrr}
\nabla_{\vec P_j}H_{\vec u}=\nabla_{\vec P_j}H_0-\vec u( \vec
r_j,t)\quad\text{and}\quad \nabla_{\vec r_j}H_{\vec u}=\nabla_{\vec
r_j}H_0-\left\{d_{\vec x}\vec u(\vec r_j,t)\right\}^T\cdot\vec
P_j\quad\quad\forall j=1,\ldots n.
\end{multline}
Next inserting
\er{vhfffngghkjgghfjjghghghSYSPNjljllkljjhkhkhklpl;lpooprr} into
\er{vhfffngghkjgghfjjghghghhjghjgghkghggSYSPNkljljjmjkjkjjkrr} we
deuce:
\begin{equation}\label{vhfffngghkjgghfjjghghghhjghjgghkghggSYSPNkljljjkklmjkjkjjkrr}
\frac{\partial H_{\vec u}}{\partial
t}+\sum_{j=1}^{n}\left(\nabla_{\vec P_j}H_0\cdot\nabla_{\vec
r_j}H_{\vec u}-\nabla_{\vec r_j}H_0\cdot\nabla_{\vec P_j}H_{\vec
u}\right)=0.
\end{equation}
Thus inserting
\er{vhfffngghkjgghfjjghghghhjghjgghkghggSYSPNkljljjmjkjjjkkjlkjkljljjkjkkjjklkjjkjhrr}
into
\er{vhfffngghkjgghfjjghghghhjghjgghkghggSYSPNkljljjkklmjkjkjjkrr} we
obtain,
\begin{multline}\label{vhfffngghkjgghfjjghghghhjghjgghkghggSYSPNkljljjkklmjkjkjjk,mmm,kl;k;lklkrr}
0=\frac{\partial H_{\vec u}}{\partial
t}+\sum_{j=1}^{n}\left(\left(\nabla_{\vec P_j}H_{\vec u}+\vec u(
\vec r_j,t)\right)\cdot\nabla_{\vec r_j}H_{\vec
u}-\left(\nabla_{\vec r_j}H_{\vec u}+\left\{d_{\vec x}\vec u(\vec
r_j,t)\right\}^T\cdot\vec P_j\right)\cdot\nabla_{\vec P_j}H_{\vec
u}\right)=\\
\frac{\partial H_{\vec u}}{\partial t}+\sum_{j=1}^{n}\vec u( \vec
r_j,t)\cdot\nabla_{\vec r_j}H_{\vec
u}-\sum_{j=1}^{n}\left(\left\{d_{\vec x}\vec u(\vec
r_j,t)\right\}^T\cdot\vec P_j\right)\cdot\nabla_{\vec P_j}H_{\vec u}
=\\
\frac{\partial H_{\vec u}}{\partial t}+\sum_{j=1}^{n}\vec u( \vec
r_j,t)\cdot\nabla_{\vec r_j}H_{\vec
u}+\sum_{j=1}^{n}\frac{1}{2}\left(\left(d_{\vec x}\vec u(\vec
r_j,t)-\left\{d_{\vec x}\vec u(\vec r_j,t)\right\}^T\right)\cdot\vec
P_j\right)\cdot\nabla_{\vec P_j}H_{\vec
u}\\-\sum_{j=1}^{n}\frac{1}{2}\left(\left(d_{\vec x}\vec u(\vec
r_j,t)+\left\{d_{\vec x}\vec u(\vec r_j,t)\right\}^T\right)\cdot\vec
P_j\right)\cdot\nabla_{\vec P_j}H_{\vec u}.
\end{multline}
Thus, by \er{apfrm9}, we rewrite
\er{vhfffngghkjgghfjjghghghhjghjgghkghggSYSPNkljljjkklmjkjkjjk,mmm,kl;k;lklkrr}
as:
\begin{multline}\label{vhfffngghkjgghfjjghghghhjghjgghkghggSYSPNkljljjkklmjkjkjjk,mmm,kl;k;lklkaarr}
\frac{\partial H_{\vec u}}{\partial t}+\sum_{j=1}^{n}\vec u( \vec
r_j,t)\cdot\nabla_{\vec r_j}H_{\vec
u}+\sum_{j=1}^{n}\frac{1}{2}\left(\left(curl_{\vec x}\vec u(\vec
r_j,t)\right)\times\vec P_j\right)\cdot\nabla_{\vec P_j}H_{\vec
u}\\-\sum_{j=1}^{n}\frac{1}{2}\left(\left(d_{\vec x}\vec u(\vec
r_j,t)+\left\{d_{\vec x}\vec u(\vec r_j,t)\right\}^T\right)\cdot\vec
P_j\right)\cdot\nabla_{\vec P_j}H_{\vec u}=0.
\end{multline}
Equality
\er{vhfffngghkjgghfjjghghghhjghjgghkghggSYSPNkljljjkklmjkjkjjk,mmm,kl;k;lklkaarr}
is the required restriction on the field $\vec u$ and the
Hamiltonian $H_{0}$, providing that our system can indeed be found
in state of thermodynamical equilibrium. In particular, if
$\frac{\partial H_0}{\partial t}=0$ and $\vec u=0$ then
\er{vhfffngghkjgghfjjghghghhjghjgghkghggSYSPNkljljjkklmjkjkjjk,mmm,kl;k;lklkaarr}
indeed holds. Moreover, as before in the case of Liouville's
equation, we deduce that the equation
\er{vhfffngghkjgghfjjghghghhjghjgghkghggSYSPNkljljjkklmjkjkjjk,mmm,kl;k;lklkaarr}
is invariant under the change of non-inertial cartesian coordinate
system of the form \er{noninchgravortbstrjgghguittu2jjjk11},
provided that we have
\er{yuythfgfyftydtydtydtyddyyyhhddhhhredPPN111hgghjgintintrrZZHHlllkljjjkrr},
\er{noninchgravortbstrjgghguittu2intrrrZZHHjkjhjh11} and
\er{yuythfgfyftydtydtydtyddyyyhhddhhhredPPN111hgghjgintintrrZZHHlll11}.
Finally, we still need to prove that if $w$ is given by
\er{vhfffngghkjgghfjjghghghSYSPNjljllkljjhkhkhklpl;lpooprr} then the
vector field $\vec u(\vec x,t)$ is indeed the macroscopic (average)
velocity field of every particle that we can found near the point
$\vec x$ at the instant of time $t$. Indeed, we need to prove the
following:
\begin{multline}\label{guytyurtydftyiujhioyhuyhSYSPNlkjljklklrr}
\vec u(\vec x,t):=\frac{m_j}{\mu_j(\vec x,t)}\int \left(\nabla_{\vec
P_j}H_0\right)w\left(\vec P_1,\ldots,\vec P_n,\vec r_1,\ldots,\vec
r_n,t\right)\delta(\vec x-\vec r_j)\,d \vec P_1,\ldots,d\vec
P_n,d\vec r_1,\ldots,d\vec r_n,
\end{multline}
where
\begin{equation}\label{vhfffngghkjgghfjjghghghhjghjgghkghggSYSPNkljljjmjkjjjkkjlkjkllkljkmjhhhkghggklklnnmmmhjjhrr}
\mu_j(\vec x,t):=\int m_j \,w\left(\vec P_1,\ldots,\vec P_n,\vec
r_1,\ldots,\vec r_n,t\right)\delta(\vec x-\vec r_j)\,d \vec
P_1,\ldots,d\vec P_n,d\vec r_1,\ldots,d\vec r_n.
\end{equation}
On the other hand, by
\er{vhfffngghkjgghfjjghghghhjghjgghkghggSYSPNkljljjmjkjjjkkjlkjkljljjkjkkjjklkjjkjhrr}
we have
\begin{equation}\label{vhfffngghkjgghfjjghghghhjghjgghkghggSYSPNkljljjmjkjjjkkjlkjkljljjkjkkjjklkjjkjhrrkokkljk}
\nabla_{\vec P_j}H_0=\nabla_{\vec P_j}H_{\vec u}+\vec u( \vec
r_j,t)\quad\quad\forall j=1,\ldots n.
\end{equation}
Thus, by
\er{vhfffngghkjgghfjjghghghhjghjgghkghggSYSPNkljljjmjkjjjkkjlkjkljljjkjkkjjklkjjkjhrrkokkljk}
and \er{vhfffngghkjgghfjjghghghSYSPNjljllkljjhkhkhklpl;lpooprr} we
deduce,
\begin{multline}\label{vhfffngghkjgghfjjghghghhjghjgghkghggSYSPNkljljjmjkjjjkkjlkjkljljjkjkkjjklkjjkjhrrkokklklklklj}
\int \left(\nabla_{\vec P_j}H_0\right)w\left(\vec P_1,\ldots,\vec
P_n,\vec r_1,\ldots,\vec r_n,t\right)\delta(\vec x-\vec r_j)\,d \vec
P_1,\ldots,d\vec P_n,d\vec r_1,\ldots,d\vec r_n=\\
\frac{1}{K(f,H_{\vec u})}\int \left(\nabla_{\vec P_j}H_{\vec
u}\right)f\left(H_{\vec u}\left(\vec P_1,\ldots,\vec P_n,\vec
r_1,\ldots,\vec r_n,t\right)\right)\delta(\vec x-\vec r_j)\,d \vec
P_1,\ldots,d\vec P_n,d\vec r_1,\ldots,d\vec r_n+\\
\int \vec u( \vec r_j,t)w\left(\vec P_1,\ldots,\vec P_n,\vec
r_1,\ldots,\vec r_n,t\right)\delta(\vec x-\vec r_j)\,d \vec
P_1,\ldots,d\vec P_n,d\vec r_1,\ldots,d\vec r_n=\\
\frac{1}{K(f,H_{\vec u})}\int \left(\nabla_{\vec
P_j}\left(F\left(H_{\vec u}\left(\vec P_1,\ldots,\vec P_n,\vec
r_1,\ldots,\vec r_n,t\right)\right)\right)\right)\delta(\vec x-\vec
r_j)\,d \vec P_1,\ldots,d\vec P_n,d\vec r_1,\ldots,d\vec
r_n\\+\frac{\mu_j(\vec x,t)}{m_j}\vec u( x,t)= 0+\frac{\mu_j(\vec
x,t)}{m_j}\vec u( x,t),
\end{multline}
where $F(s):=\int_0^s f(\tau)d\tau$. So, by
\er{vhfffngghkjgghfjjghghghhjghjgghkghggSYSPNkljljjmjkjjjkkjlkjkljljjkjkkjjklkjjkjhrrkokklklklklj}
we indeed prove \er{guytyurtydftyiujhioyhuyhSYSPNlkjljklklrr}.

Next consider the following conditions
\begin{equation}\label{vhfffngghkjgghfjjghghghhjghjgghkghggSYSPNkljljjkklmjkjkjjk,mmm,kl;k;lklkjkjkkjkjkjkjljljjljklggghgkhhkhkhkkklkklklkljkhkhjjkhjh22}
\begin{cases}
\frac{\partial U}{\partial t}\left(\vec x_1,\ldots,\vec
x_n,t\right)+\sum_{j=1}^{n}\vec u( \vec x_j,t)\cdot\nabla_{\vec
x_j}U\left(\vec x_1,\ldots,\vec x_n,t\right)=0,\\
\frac{\partial}{\partial t}\left(\vec u(\vec x,t)-\vec v(\vec
x,t)\right)-curl_{\vec x}\left(\vec u(\vec x,t)\times\left(\vec
u(\vec x,t)-\vec v(\vec x,t)\right)\right)+\left(\Div_{\vec
x}\left(\vec u(\vec x,t)-\vec v(\vec x,t)\right)\right)\vec
u(\vec x,t)=0,\\
\frac{\partial\vec A}{\partial t}(\vec x,t)-curl_{\vec x}\left(\vec
u(\vec x,t)\times\vec A(\vec x,t)\right)+\left(\Div_{\vec x}\vec
A(\vec x,t)\right)\vec u(\vec x,t)=0,
\\
d_{\vec x}\vec u(\vec x,t)+\left\{d_{\vec x}\vec u(\vec
x,t)\right\}^T=0,
\end{cases}
\end{equation}
where
\begin{multline}\label{vhfffngghkjgghfjjghghghSYSPNjljll'lhkhhjgggjgjgghgjkjjk22}
U\left(\vec x_1,\ldots,\vec x_n,t\right):=\\
\sum_{j=1}^{n}\left(\frac{\sigma^2_j}{2m_jc^2}\left|\vec A(\vec
x_j,t)\right|^2+\sigma_j\Psi(\vec x_j,t)-\frac{\sigma_j}{c}\vec
A(\vec x_j,t)\cdot\vec v(\vec x_j,t)\right)-V\left(\vec
x_1,\ldots,\vec x_n,t\right),
\end{multline}
that are the same conditions as in
\er{vhfffngghkjgghfjjghghghhjghjgghkghggSYSPNkljljjkklmjkjkjjk,mmm,kl;k;lklkjkjkkjkjkjkjljljjljklggghgkhhkhkhkkklkklklkljkhkhjjkhjh},
\er{vhfffngghkjgghfjjghghghSYSPNjljll'lhkhhjgggjgjgghgjkjjk} and
were the necessary conditions for the Thermodynamical equilibrium in
the fully non-relativistic case. Moreover, as before, equations
\er{vhfffngghkjgghfjjghghghhjghjgghkghggSYSPNkljljjkklmjkjkjjk,mmm,kl;k;lklkjkjkkjkjkjkjljljjljklggghgkhhkhkhkkklkklklkljkhkhjjkhjh22},
\er{vhfffngghkjgghfjjghghghSYSPNjljll'lhkhhjgggjgjgghgjkjjk22} are
invariant under the change of inertial or non-inertial cartesian
coordinate systems. We will prove now that, as in the fully
non-relativistic case, conditions
\er{vhfffngghkjgghfjjghghghhjghjgghkghggSYSPNkljljjkklmjkjkjjk,mmm,kl;k;lklkjkjkkjkjkjkjljljjljklggghgkhhkhkhkkklkklklkljkhkhjjkhjh22},
\er{vhfffngghkjgghfjjghghghSYSPNjljll'lhkhhjgggjgjgghgjkjjk22} imply
\er{vhfffngghkjgghfjjghghghhjghjgghkghggSYSPNkljljjkklmjkjkjjk,mmm,kl;k;lklkrr}
or equivalently
\er{vhfffngghkjgghfjjghghghhjghjgghkghggSYSPNkljljjkklmjkjkjjk,mmm,kl;k;lklkaarr}
also in the case of the relativistic-like Hamiltonian
\er{vhfffngghkjgghfjjghghghSYSPNjljll;kl;rr} and the quantity
$H_{\vec u}$ in \er{vhfffngghkjgghfjjghghghSYSPNjljll'lrr}.

Indeed, by
\er{vhfffngghkjgghfjjghghghhjghjgghkghggSYSPNkljljjkklmjkjkjjk,mmm,kl;k;lklkjkjkkjkjkjkjljljjljklggghgkhhkhkhkkklkklklkljkhkhjjkhjh22}
and Proposition \ref{jhjkhhjhjhgjgjjkjk} there exists another
cartesian coordinate system $(**)$ such that under the change of
coordinate system $(*)$ to another cartesian coordinate system
$(**)$, given by \er{noninchgravortbstrjgghguittu2hhhhh}, we have
\begin{equation}
\label{jljjjjkhhhjkk}
A(t)\cdot \vec u(\vec x,t)+
\frac{d A}{dt}(t)\cdot\vec x+
\frac{d\vec z}{dt}(t)=\vec u'(\vec x',t')=0.
\end{equation}
Thus, since
\er{vhfffngghkjgghfjjghghghhjghjgghkghggSYSPNkljljjkklmjkjkjjk,mmm,kl;k;lklkjkjkkjkjkjkjljljjljklggghgkhhkhkhkkklkklklkljkhkhjjkhjh22},
\er{vhfffngghkjgghfjjghghghSYSPNjljll'lhkhhjgggjgjgghgjkjjk22} are
invariant under the change of inertial or non-inertial cartesian
coordinate systems, in system $(**)$ we have
\begin{equation}\label{vhfffngghkjgghfjjghghghhjghjgghkghggSYSPNkljljjkklmjkjkjjk,mmm,kl;k;lklkjkjkkjkjkjkjljljjljklggghgkhhkhkhkkklkklklkljkhkhjjkhjhmjmm,mjhjkjk22}
\begin{cases}
\frac{\partial U'}{\partial t'}\left(\vec x'_1,\ldots,\vec
x'_n,t'\right)=0,\\
\frac{\partial\vec v'}{\partial t'}(\vec
x',t')=0,\\
\frac{\partial\vec A'}{\partial t'}(\vec x',t')=0
\\
\vec u'(\vec x',t')=0,
\end{cases}
\end{equation}
where
\begin{multline}\label{vhfffngghkjgghfjjghghghSYSPNjljll'lhkhhjgggjgjgghgjkjjkkj22}
U'\left(\vec x'_1,\ldots,\vec x'_n,t'\right):=\\
\sum_{j=1}^{n}\left(\frac{(\sigma'_j)^2}{2m'_jc^2}\left|\vec A'(\vec
x'_j,t')\right|^2+\sigma'_j\Psi'(\vec
x'_j,t')-\frac{\sigma'_j}{c}\vec A'(\vec x'_j,t')\cdot\vec v'(\vec
x'_j,t')\right)-V'\left(\vec x'_1,\ldots,\vec x'_n,t'\right).
\end{multline}
On the other hand,
\er{vhfffngghkjgghfjjghghghhjghjgghkghggSYSPNkljljjkklmjkjkjjk,mmm,kl;k;lklkjkjkkjkjkjkjljljjljklggghgkhhkhkhkkklkklklkljkhkhjjkhjhmjmm,mjhjkjk22}
is equivalent to
\begin{equation}\label{vhfffnhhjhjjkkjhk22}
\frac{\partial H'_{0}}{\partial t'}\left(\vec P'_1,\ldots,\vec
P'_n,\vec r'_1,\ldots,\vec r'_n,t'\right)=0\quad\text{and}\quad\vec
u'(\vec x',t')=0,
\end{equation}
where $H_0$ is given by
\er{vhfffngghkjgghfjjghghghSYSPNjljll;kl;rr}. Then
\er{vhfffnhhjhjjkkjhk22} implies that we have the primed version of
\er{vhfffngghkjgghfjjghghghhjghjgghkghggSYSPNkljljjkklmjkjkjjk,mmm,kl;k;lklkaarr}
and therefore, since
\er{vhfffngghkjgghfjjghghghhjghjgghkghggSYSPNkljljjkklmjkjkjjk,mmm,kl;k;lklkaarr}
is invariant under the change of cartesian coordinate systems obtain
also the original version of
\er{vhfffngghkjgghfjjghghghhjghjgghkghggSYSPNkljljjkklmjkjkjjk,mmm,kl;k;lklkaarr}.
So,
\er{vhfffngghkjgghfjjghghghhjghjgghkghggSYSPNkljljjkklmjkjkjjk,mmm,kl;k;lklkjkjkkjkjkjkjljljjljklggghgkhhkhkhkkklkklklkljkhkhjjkhjh22},
\er{vhfffngghkjgghfjjghghghSYSPNjljll'lhkhhjgggjgjgghgjkjjk22} imply
\er{vhfffngghkjgghfjjghghghhjghjgghkghggSYSPNkljljjkklmjkjkjjk,mmm,kl;k;lklkaarr}
or equivalently
\er{vhfffngghkjgghfjjghghghhjghjgghkghggSYSPNkljljjkklmjkjkjjk,mmm,kl;k;lklkrr}.
Thus, we again obtain that the condition for existing of the
possibility to achieve the Thermodynamical equilibrium in the given
system, ruled by the Hamiltonian $H_0\left(\vec P_1,\ldots,\vec
P_n,\vec r_1,\ldots,\vec r_n,t\right)$ of the form
\er{vhfffngghkjgghfjjghghghSYSPNjljll;kl;rr}, is equivalent to the
existence of a cartesian coordinate system $(**)$, where the average
velocity of every particle vanishes and at the same time in the
system $(**)$ the Hamiltonian is independent of the time variable
explicitly.

\subsection{Relativistic-like Dirac equation for a system of $n$
spin-half micro-particles}\label{ghghggjghfgjknew} Consider the
motion of a system of $n$ spin-half relativistic-like stable
micro-particles with inertial masses $m_1,\ldots,m_n$ and the
charges $\sigma_1,\ldots,\sigma_n$ in the outer gravitational and
electromagnetical field with characteristics $\vec v(\vec x,t)$,
$\vec A(\vec x,t)$ and $\Psi(\vec x,t)$ and additional conservative
field with potential $V(\vec y_1,\ldots,\vec y_n,t)$, taking into
account the spin interaction. Then the system is characterized by
$4^n$-component complex wave function $\psi(\vec x_1,\ldots,\vec
x_n,t)\in\mathbb{C}^{4^n}$ where by $\mathbb{C}^{4^n}$ we denote the
tensor product of $n$ copies of the space $\mathbb{C}^2\otimes
\mathbb{C}^2=\mathbb{C}^4$:
\begin{equation}\label{fyfgjguyiuiHHnew}
\mathbb{C}^{4^n}:=\left(\mathbb{C}^4\right)\otimes_1\left(\mathbb{C}^4\right)\otimes_2\left(\mathbb{C}^4\right)
\ldots\otimes_{(n-1)}\left(\mathbb{C}^4\right),
\end{equation}
and given $a=(a_1,a_2)\in\mathbb{C}^2$ and
$b=(b_1,b_2)\in\mathbb{C}^2$ we identify their tensor product
$a\otimes b\in\mathbb{C}^2\otimes \mathbb{C}^2$ with the vector
$(a_1,a_2,b_1,b_2)\in\mathbb{C}^4$. Then, as before in
\er{vhfffngghkjgghfjjghghghSYShmyuuiiuuhmhmiopoopnniukjhjkkkJJ},\er{vhfffngghkjgghfjjghghghSYShmyuuiiuuhmhmiopoopnniukjhjkhhjkkJJ}
and in \er{vhfffngghkjgghfjjghghghhjghjgghkghgghjhggjjkgSYSPNHH}
consistently with the classical Hamiltonian
\er{vhfffngghkjgghfjjghghghSYShmyuuiiuuhmhmiopoopKKkknew}, we built
the Hermitian Hamiltonian operator as:
\begin{multline}\label{vhfffngghkjgghfjjghghghhjghjgghkghgghjhggjjkgSYSPNHHnew} \hat
H_0\cdot\psi=\\
\sum_{j=1}^{n}m_jc^2D^j_4\cdot\psi-\sum_{j=1}^{n}c\,\vec
D_j\cdot\left(i\hbar\nabla_{\vec x_j}\psi+\frac{\sigma_j}{c}\vec
A(\vec x_j,t)\psi\right)
%
%
%
+\sum_{j=1}^{n}\sigma_j\left(\Psi(\vec x_j,t)-\frac{1}{c}\vec v(\vec
x_j,t)\cdot\vec A(\vec x_j,t)\right)\psi\\-V\left(\vec
x_1,\ldots,\vec
x_n,t\right)\psi-\sum_{j=1}^{n}\frac{i\hbar}{2}div_{\vec
x_j}\left\{\psi\vec v(\vec
x_j,t)\right\}-\sum_{j=1}^{n}\frac{i\hbar}{2}\nabla_{\vec
x_j}\psi\cdot\vec v(\vec x_j,t)+\sum_{j=1}^{n}\frac{\hbar}{4}\vec
M_j\cdot\left(curl_{\vec x_j}\vec v(\vec x_j,t)\,\psi\right)
\\-\sum_{j=1}^{n}\frac{(g_j-1)\sigma_j\hbar}{2m_jc}\vec
M_j\cdot\left(curl_{\vec x_j}\vec A(\vec x_j,t)\,\psi\right),
\end{multline}
where $\hat H_0$ is the Hamiltonian operator, for every
$j=1,\ldots,n$ $g_j$ is a constant that depends on the type of the
particle (for electron we have $g_j=1$) and for every $j=1,\ldots,
n$ we denote
\begin{equation}\label{fyfgjguyiuiHHHHHHnew}
\vec D_j:=(D^j_1,D^j_2,D^j_3)\quad\text{and}\quad\vec
M_j:=(M^j_1,M^j_2,M^j_3)\quad\quad\forall\, j=1,2,\ldots,n,
\end{equation}
where for every $k=1,2,3,4$ and every $j=1,2,\ldots,n$:
$D^j_k:\mathbb{C}^{4^n}\to \mathbb{C}^{4^n}$ is a linear operator on
$\mathbb{C}^{4^n}$ (i.e. it is a $4^n\times 4^n$-complex matrix)
defined by the following identities:
\begin{multline}\label{fyfgjguyiuiHHHHHHHHHHuiunew}
D^1_k:=\left(D_k\right)\otimes_{1}\left(I^{4\times
4}\right)\otimes_{2}\left(I^{4\times
4}\right)\ldots\otimes_{(n-1)}\left(I^{4\times
4}\right),\quad\quad\ldots\quad
\\
D^j_k:=\left(I^{4\times 4}\right)\otimes_1\left(I^{4\times
4}\right)\otimes_2\left(I^{4\times 4}\right)
\ldots\otimes_{(j-1)}\left(D_k\right)\otimes_{j}\left(I^{4\times
4}\right)\otimes_{(j+1)}\left(I^{4\times
4}\right)\ldots\otimes_{(n-1)}\left(I^{4\times 4}\right),
\\
\quad\quad\ldots\quad\text{and}\quad\quad D^n_k:=\left(I^{4\times
4}\right)\otimes_1\left(I^{4\times
4}\right)\otimes_2\left(I^{4\times 4}\right)
\ldots\otimes_{(n-2)}\left(I^{4\times
4}\right)\otimes_{(n-1)}\left(D_k\right),
\end{multline}
and for every $k=1,2,3$ and every $j=1,2,\ldots,n$:
$M^j_k:\mathbb{C}^{4^n}\to \mathbb{C}^{4^n}$ is a linear operator on
$\mathbb{C}^{4^n}$ (i.e. it is a $4^n\times 4^n$-complex matrix)
defined by the following identities:
\begin{multline}\label{fyfgjguyiuiHHHHHHHHHHuiuhnhgbbnew}
M^1_k:=\left(M_k\right)\otimes_{1}\left(I^{4\times
4}\right)\otimes_{2}\left(I^{4\times
4}\right)\ldots\otimes_{(n-1)}\left(I^{4\times
4}\right),\quad\quad\ldots\quad
\\
M^j_k:=\left(I^{4\times 4}\right)\otimes_1\left(I^{4\times
4}\right)\otimes_2\left(I^{4\times 4}\right)
\ldots\otimes_{(j-1)}\left(M_k\right)\otimes_{j}\left(I^{4\times
4}\right)\otimes_{(j+1)}\left(I^{4\times
4}\right)\ldots\otimes_{(n-1)}\left(I^{4\times 4}\right),
\\
\quad\quad\ldots\quad\text{and}\quad\quad M^n_k:=\left(I^{4\times
4}\right)\otimes_1\left(I^{4\times
4}\right)\otimes_2\left(I^{4\times 4}\right)
\ldots\otimes_{(n-2)}\left(I^{4\times
4}\right)\otimes_{(n-1)}\left(M_k\right).
\end{multline}
Here $D_k$ for $k=1,2,3,4$ are $4\times 4$ Dirac $\alpha$-matrixes
defined as:
\begin{multline}\label{ghgyfftdtdnew}
D_1=\left(\begin{matrix}O^{2\times 2}&S_1\\S_1&O^{2\times 2}\end{matrix}\right)=\left(\begin{matrix}0&0&0&1\\0&0&1&0\\
0&1&0&0\\1&0&0&0\end{matrix}\right),\quad
D_2=\left(\begin{matrix}O^{2\times 2}&S_2\\S_2&O^{2\times 2}\end{matrix}\right)=\left(\begin{matrix}0&0&0&-i\\0&0&i&0\\
0&-i&0&0\\i&0&0&0\end{matrix}\right),\\
D_3=\left(\begin{matrix}O^{2\times 2}&S_3\\S_3&O^{2\times 2}\end{matrix}\right)=\left(\begin{matrix}0&0&1&0\\0&0&0&-1\\
1&0&0&0\\0&-1&0&0\end{matrix}\right), \quad
D_4=\left(\begin{matrix}I^{2\times 2}&O^{2\times 2}\\O^{2\times 2}&-I^{2\times 2}\end{matrix}\right)=\left(\begin{matrix}1&0&0&0\\0&1&0&0\\
0&0&-1&0\\0&0&0&-1\end{matrix}\right),
\end{multline}
and $M_k$ for $k=1,2,3,4$ are $4\times 4$ matrixes defined as:
\begin{multline}\label{ghgyfftdtdljkljnnew}
M_1=\left(\begin{matrix}S_1&O^{2\times 2}\\O^{2\times 2}&S_1\end{matrix}\right)=\left(\begin{matrix}0&1&0&0\\1&0&0&0\\
0&0&0&1\\0&0&1&0\end{matrix}\right),\quad
M_2=\left(\begin{matrix}S_2&O^{2\times 2}\\O^{2\times 2}&S_2\end{matrix}\right)=\left(\begin{matrix}0&-i&0&0\\i&0&0&0\\
0&0&0&-i\\0&0&i&0\end{matrix}\right),\\
M_3=\left(\begin{matrix}S_3&O^{2\times 2}\\O^{2\times 2}&S_3\end{matrix}\right)=\left(\begin{matrix}1&0&0&0\\0&-1&0&0\\
0&0&1&0\\0&0&0&-1\end{matrix}\right),
\end{multline}
where $S_k$ for $k=1,2,3$ are Pauli matrixes defined as:
\begin{equation}\label{fyfgjguyiuiHHHHHHkjjjjljkjkjknew}
S_1=\left(\begin{matrix}0&1\\1&0\end{matrix}\right),\quad
S_2=\left(\begin{matrix}0&-i\\i&0\end{matrix}\right),\quad
S_3=\left(\begin{matrix}1&0\\0&-1\end{matrix}\right)
\end{equation}
and the sign $\otimes$ in \er{fyfgjguyiuiHHHHHHHHHHuiunew} and
\er{fyfgjguyiuiHHHHHHHHHHuiuhnhgbbnew} means the tensor product of
the matrices, i.e. for given two linear operators
$A:\mathbb{C}^{p}\to\mathbb{C}^{r}$ and
$B:\mathbb{C}^{q}\to\mathbb{C}^{d}$
their tensor product
$A\otimes B$ is a linear operator from
$\mathbb{C}^{p}\otimes\mathbb{C}^{q}$ to
$\mathbb{C}^{r}\otimes\mathbb{C}^{d}$, defined by the identity:
\begin{equation}\label{fyfgjguyiuiHHHHHHkjjjjlnew}
(A\otimes B)\cdot(a\otimes b)=(A\cdot a)\otimes (B\cdot
b)\quad\quad\forall a\in\mathbb{C}^{p},\;\forall b\in\mathbb{C}^{q}.
\end{equation}
Note that, we added an aditional term to the Hamiltonian, namely
$\sum_{j=1}^{n}\frac{\hbar}{4}\vec D_j\cdot\left(curl_{\vec x_j}\vec
v(\vec x_j,t)\,\psi\right)$. In the case of the Newtonian-type
gravity, this term vanishes in every non-rotating and, in
particular, in every inertial coordinate system, however it provides
the invariance of the Dirac equation, under the change of
non-inertial cartesian coordinate system, as can be seen in the
following Theorem \ref{gjghghgghgintintrrZZHHnew}. Thus the
corresponding Dirac equation will be the following:
\begin{multline}\label{vhfffngghkjgghfjjghghghhjghjgghkghgghjhggjjkgfgdSYSPNHHnew}
i\hbar\frac{\partial\psi}{\partial t}=\hat H_0\cdot\psi=\\
\sum_{j=1}^{n}m_jc^2D^j_4\cdot\psi-\sum_{j=1}^{n}c\,\vec
D_j\cdot\left(i\hbar\nabla_{\vec x_j}\psi+\frac{\sigma_j}{c}\vec
A(\vec x_j,t)\psi\right)
%
%
%
+\sum_{j=1}^{n}\sigma_j\left(\Psi(\vec x_j,t)-\frac{1}{c}\vec v(\vec
x_j,t)\cdot\vec A(\vec x_j,t)\right)\psi\\-V\left(\vec
x_1,\ldots,\vec
x_n,t\right)\psi-\sum_{j=1}^{n}\frac{i\hbar}{2}div_{\vec
x_j}\left\{\psi\vec v(\vec
x_j,t)\right\}-\sum_{j=1}^{n}\frac{i\hbar}{2}\nabla_{\vec
x_j}\psi\cdot\vec v(\vec x_j,t)+\sum_{j=1}^{n}\frac{\hbar}{4}\vec
M_j\cdot\left(curl_{\vec x_j}\vec v(\vec x_j,t)\,\psi\right)\\
-\sum_{j=1}^{n}\frac{(g_j-1)\sigma_j\hbar}{2m_jc}\vec
M_j\cdot\left(curl_{\vec x_j}\vec A(\vec x_j,t)\,\psi\right).
\end{multline}
So
%
%
%
%
%
%
\begin{multline}\label{vhfffngghkjgghfjjghghghhjghjgghkghgghjhggjjkgfgdiyfgfjkjgjgSYSPNHHnew}
i\hbar\left(\frac{\partial\psi}{\partial t}+\sum_{j=1}^{n}\vec
v(\vec x_j,t)\cdot\nabla_{\vec
x_j}\psi\right)+\sum_{j=1}^{n}\frac{i\hbar}{2}\left(div_{\vec
x_j}\vec v(\vec x_j,t)\right)\psi=\\
\sum_{j=1}^{n}m_jc^2D^j_4\cdot\psi-\sum_{j=1}^{n}c\,\vec
D_j\cdot\left(i\hbar\nabla_{\vec x_j}\psi+\frac{\sigma_j}{c}\vec
A(\vec x_j,t)\psi\right)
%
%
%
+\sum_{j=1}^{n}\sigma_j\left(\Psi(\vec x_j,t)-\frac{1}{c}\vec v(\vec
x_j,t)\cdot\vec A(\vec x_j,t)\right)\psi\\-V\left(\vec
x_1,\ldots,\vec x_n,t\right)\psi+\sum_{j=1}^{n}\frac{\hbar}{4}\vec
M_j\cdot\left(curl_{\vec x_j}\vec v(\vec
x_j,t)\,\psi\right)\\-\sum_{j=1}^{n}\frac{(g_j-1)\sigma_j\hbar}{2m_jc}\vec
M_j\cdot\left(curl_{\vec x_j}\vec A(\vec x_j,t)\,\psi\right).
\end{multline}
We can rewrite the Dirac equation
\er{vhfffngghkjgghfjjghghghhjghjgghkghgghjhggjjkgfgdiyfgfjkjgjgSYSPNHHnew}
in the following alternative form. First, define the linear
operators $T_0:\mathbb{C}^{4}\to \mathbb{C}^{2}$ and
$T_1:\mathbb{C}^{4}\to \mathbb{C}^{2}$ as:
\begin{equation}\label{fyfgjguyiuiHHHHHHkjjjjljkjkjknewjhkhh}
T_0\cdot (a_1,a_2,b_1,b_2)=(a_1,a_2)\quad\text{and}\quad T_1\cdot
(a_1,a_2,b_1,b_2)=(b_1,b_2)\quad\forall\,(a_1,a_2,b_1,b_2)\in\mathbb{C}^4.
\end{equation}
Then for every $k_1,\ldots ,k_n=0,1$ define $\tilde T_{k_1,\ldots,
k_n}:\mathbb{C}^{4^n}\to \mathbb{C}^{2^n}$ as
\begin{equation}\label{fyfgjguyiuiHHHHHHkjjjjljkjkjknewjhkhhlkjjlljk}
\tilde T_{k_1,\ldots,
k_n}:=\left(T_{k_1}\right)\otimes_{1}\left(T_{k_2}\right)\otimes_{2}\left(T_{k_3}\right)\ldots\otimes_{(n-1)}\left(T_{k_n}\right)\quad\quad\forall\,
k_1,\ldots k_n\in\{0,1\}\,,
\end{equation}
and define $\psi_{k_1,\ldots ,k_n}\in \mathbb{C}^{2^n}$ as:
\begin{equation}\label{fyfgjguyiuiHHHHHHkjjjjljkjkjknewjhkhhlkjjlljkjkjk}
\psi_{k_1,\ldots ,k_n}(\vec x_1,\ldots,\vec x_n,t):=\tilde
T_{k_1,\ldots ,k_n}\cdot\psi(\vec x_1,\ldots,\vec
x_n,t)\quad\quad\forall\, k_1,\ldots k_n\in\{0,1\}\,.
\end{equation}
Then we rewrite
\er{vhfffngghkjgghfjjghghghhjghjgghkghgghjhggjjkgfgdiyfgfjkjgjgSYSPNHHnew}
as:
\begin{multline}\label{vhfffngghkjgghfjjghghghhjghjgghkghgghjhggjjkgfgdiyfgfjkjgjgSYSPNHHnewlgugggg}
i\hbar\left(\frac{\partial}{\partial t}\psi_{k_1,\ldots,
k_n}+\sum_{j=1}^{n}\vec v(\vec x_j,t)\cdot\nabla_{\vec
x_j}\psi_{k_1,\ldots,
k_n}\right)+\sum_{j=1}^{n}\frac{i\hbar}{2}\left(div_{\vec x_j}\vec
v(\vec x_j,t)\right)\psi_{k_1,\ldots,
k_n}\\=\sum_{j=1}^{n}\frac{\hbar}{4}\vec S_j\cdot\left(curl_{\vec
x_j}\vec v(\vec x_j,t)\,\psi_{k_1,\ldots, k_n}\right)+
\sum_{j=1}^{n}m_jc^2\,(-1)^{2+k_j}\,\psi_{k_1,\ldots,
k_n}\\-\sum_{j=1}^{n}c\,\vec S_j\cdot\left(i\hbar\,\nabla_{\vec
x_j}\psi_{k_1,\ldots,k_{(j-1)},(1-k_j),k_{(j+1)},\ldots,
k_n}+\frac{\sigma_j}{c}\vec A(\vec
x_j,t)\,\psi_{k_1,\ldots,k_{(j-1)},(1-k_j),k_{(j+1)},\ldots,
k_n}\right)
%
%
%
\\+\sum_{j=1}^{n}\sigma_j\left(\Psi(\vec x_j,t)-\frac{1}{c}\vec
v(\vec x_j,t)\cdot\vec A(\vec x_j,t)\right)\psi_{k_1,\ldots,
k_n}-V\left(\vec x_1,\ldots,\vec x_n,t\right)\psi_{k_1,\ldots, k_n}\\
-\sum_{j=1}^{n}\frac{(g_j-1)\sigma_j\hbar}{2m_jc}\vec
S_j\cdot\left(curl_{\vec x_j}\vec A(\vec x_j,t)\,\psi_{k_1,\ldots,
k_n}\right)\quad\quad\forall\, k_1,\ldots k_n\in\{0,1\}\,.
\end{multline}
Next, in the similar way as the proof of Theorems
\ref{gjghghgghgintintrrZZjkh} and \ref{gjghghgghgintintrrZZHH} we
can prove the following more general Theorem about the invariance of
the Dirac equation
\er{vhfffngghkjgghfjjghghghhjghjgghkghgghjhggjjkgfgdiyfgfjkjgjgSYSPNHHnew}
under the change of inertial or non-inertial cartesian coordinate
system:
\begin{theorem}\label{gjghghgghgintintrrZZHHnew}
Consider that the change of some cartesian coordinate system $(*)$
to another cartesian coordinate system $(**)$ is given by
\begin{equation}\label{noninchgravortbstrjgghguittu2intrrrZZHHnew}
\begin{cases}
t'=t,\\
\vec x'=A(t)\cdot\vec x+\vec z(t),\\
\vec x'_k=A(t)\cdot\vec x_k+\vec z(t)\quad\quad\forall k=1,\ldots,n,
\end{cases}
\end{equation}
where $A(t)\in SO(3)$ is a rotation. Next, assume that in the
coordinate system $(**)$ we observe a validity of the Dirac equation
of the form:
\begin{multline}\label{MaxVacFull1ninshtrredPPNintrrZZHHnew}
i\hbar\left(\frac{\partial\psi'}{\partial t'}+\sum_{j=1}^{n}\vec
v'(\vec x'_j,t')\cdot\nabla_{\vec
x'_j}\psi'\right)+\sum_{j=1}^{n}\frac{i\hbar}{2}\left(div_{\vec
x'_j}\vec v'(\vec x'_j,t')\right)\psi'=
\\
\sum_{j=1}^{n}m'_jc^2D^j_4\psi'-\sum_{j=1}^{n}c\,\vec
D_j\cdot\left(i\hbar\nabla_{\vec x'_j}\psi'+\frac{\sigma'_j}{c}\vec
A'(\vec x'_j,t')\psi'\right)
%
%
%
+\sum_{j=1}^{n}\sigma'_j\left(\Psi'(\vec x'_j,t')-\frac{1}{c}\vec
v'(\vec x'_j,t')\cdot\vec A'(\vec
x'_j,t')\right)\psi'\\-V'\left(\vec x'_1,\ldots,\vec
x'_n,t'\right)\psi'+\sum_{j=1}^{n}\frac{\hbar}{4}\vec
M_j\cdot\left(curl_{\vec x'_j}\vec v'(\vec
x'_j,t')\,\psi'\right)-\sum_{j=1}^{n}\frac{(g'_j-1)\sigma'_j\hbar}{2m'_jc}\vec
M_j\cdot\left(curl_{\vec x'_j}\vec A'(\vec x'_j,t')\,\psi'\right)
\end{multline}
where $\psi(\vec x_1,\ldots,\vec x_n,t)\in\mathbb{C}^{4^n}$ is a
$4^n$-component complex wave function defined above. Then in the
coordinate system $(*)$ we have the validity of Dirac equation of
the same as \er{MaxVacFull1ninshtrredPPNintrrZZHHnew} form:
\begin{multline}\label{MaxVacFull1ninshtrhjkkredPPNintrrZZHHnew}
i\hbar\left(\frac{\partial\psi}{\partial t}+\sum_{j=1}^{n}\vec
v(\vec x_j,t)\cdot\nabla_{\vec
x_j}\psi\right)+\sum_{j=1}^{n}\frac{i\hbar}{2}\left(div_{\vec
x_j}\vec v(\vec x_j,t)\right)\psi=\\
\sum_{j=1}^{n}m_jc^2D^j_4\psi-\sum_{j=1}^{n}c\,\vec
D_j\cdot\left(i\hbar\nabla_{\vec x_j}\psi+\frac{\sigma_j}{c}\vec
A(\vec x_j,t)\psi\right)
%
%
%
+\sum_{j=1}^{n}\sigma_j\left(\Psi(\vec x_j,t)-\frac{1}{c}\vec v(\vec
x_j,t)\cdot\vec A(\vec x_j,t)\right)\psi\\-V\left(\vec
x_1,\ldots,\vec x_n,t\right)\psi+\sum_{j=1}^{n}\frac{\hbar}{4}\vec
M_j\cdot\left(curl_{\vec x_j}\vec v(\vec
x_j,t)\,\psi\right)-\sum_{j=1}^{n}\frac{(g_j-1)\sigma_j\hbar}{2mc}\vec
M_j\cdot\left(curl_{\vec x_j}\vec A(\vec x_j,t)\,\psi\right),
\end{multline}
provided that we have:
\begin{equation}\label{yuythfgfyftydtydtydtyddyyyhhddhhhredPPN111hgghjgintintrrZZHHnew}
\begin{cases}
V'\left(\vec x'_1,\ldots,\vec x'_n,t'\right)\,=\,V\left(\vec
x_1,\ldots,\vec x_n,t\right),\\
g'_j\,=\,g_j,\\
\sigma'_j\,=\,\sigma_j,\\
m'_j\,=\,m_j,\\
\vec v'(\vec x',t)\,=\,A(t)\cdot \vec v(\vec x,t)+\frac{dA}{dt}(t)\cdot\vec x+\frac{d\vec z}{dt}(t),\\
\vec A'(\vec x',t)\,=\,A(t)\cdot \vec A(\vec x,t),\\
\Psi'(\vec x',t)-\vec v'(\vec x',t)\cdot\vec A'(\vec x',t)\,=\,\Psi(\vec x,t)-\vec v(\vec x,t)\cdot\vec A(\vec x,t),\\
\psi'\left(\vec x'_1,\ldots,\vec
x'_n,t'\right)\,=\,\left(W(t)\otimes_{1}W(t)\otimes_{2}W(t)\ldots\otimes_{(n-1)}W(t)
\right)\cdot\psi\left(\vec x_1,\ldots,\vec x_n,t\right),
\end{cases}
\end{equation}
with $4\times 4$ matrix $W(t)$ defined by
\begin{equation}\label{ghgyfftdtdljkljnnewnnnew}
W(t):=\left(\begin{matrix}U(t)&O^{2\times 2}\\O^{2\times
2}&U(t)\end{matrix}\right)=U(t)\otimes I^{2\times 2},
\end{equation}
where, as before, $U(t)\in SU(2)$
is characterized by:
\begin{equation}\label{gyfyfgfgfgghZZHHnew}
U^*(t)\cdot\vec S\cdot U(t)=A(t)\cdot\vec S.
\end{equation}
where
$\vec S:=(S_1,S_2,S_3)$ with Pauli matrices defined by
\er{fyfgjguyiuiHHHHHHkjjjjljkjkjknew},
that means
\begin{multline*}
\left(U^*(t)\cdot S_1\cdot U(t),U^*(t)\cdot S_2\cdot
U(t),U^*(t)\cdot S_3\cdot U(t)\right)=\\
\left(a_{11}(t)S_1+a_{12}(t)S_2+a_{13}(t)S_3\,,\,a_{21}(t)S_1+a_{22}(t)S_2+a_{23}(t)S_3\,,\,a_{31}(t)S_1+a_{32}(t)S_2+a_{33}(t)S_3\right),
\end{multline*}
where $A(t)=\left\{a_{mk}(t)\right\}_{\{1\leq m,k\leq 3\}}$.
\end{theorem}

Next, in the case that the system of our particles has a positive
energy, define
\begin{equation}\label{gyfyfggjhgjhnewkljjknew}
\phi:=e^{-\frac{1}{\hbar}ic^2\left(\sum_{p=1}^{n}m_p\right)t}\psi\quad\text{and}\quad
\phi_{k_1,\ldots,
k_n}:=e^{-\frac{1}{\hbar}ic^2\left(\sum_{p=1}^{n}m_p\right)t}\psi_{k_1,\ldots,
k_n}\quad\quad\forall\, k_1,\ldots k_n\in\{0,1\}\,.
\end{equation}
Then we rewrite the Dirac equations
\er{vhfffngghkjgghfjjghghghhjghjgghkghgghjhggjjkgfgdiyfgfjkjgjgSYSPNHHnewlgugggg}
as
\begin{multline}\label{vhfffngghkjgghfjjghghghhjghjgghkghgghjhggjjkgfgdiyfgfjkjgjgSYSPNHHnewlguggggjkkhh}
\sum_{j=1}^{n}m_jc^2\,\left(1-(-1)^{2+k_j}\right)\,\phi_{k_1,\ldots,
k_n}\\+i\hbar\left(\frac{\partial}{\partial t}\phi_{k_1,\ldots,
k_n}+\sum_{j=1}^{n}\vec v(\vec x_j,t)\cdot\nabla_{\vec
x_j}\phi_{k_1,\ldots,
k_n}\right)+\sum_{j=1}^{n}\frac{i\hbar}{2}\left(div_{\vec x_j}\vec
v(\vec x_j,t)\right)\phi_{k_1,\ldots,
k_n}\\=\sum_{j=1}^{n}\frac{\hbar}{4}\vec S_j\cdot\left(curl_{\vec
x_j}\vec v(\vec x_j,t)\,\phi_{k_1,\ldots,
k_n}\right)-\sum_{j=1}^{n}\frac{(g_j-1)\sigma_j\hbar}{2m_jc}\vec
S_j\cdot\left(curl_{\vec x_j}\vec A(\vec x_j,t)\,\phi_{k_1,\ldots,
k_n}\right)\\-\sum_{j=1}^{n}c\,\vec
S_j\cdot\left(i\hbar\,\nabla_{\vec
x_j}\phi_{k_1,\ldots,k_{(j-1)},(1-k_j),k_{(j+1)},\ldots,
k_n}+\frac{\sigma_j}{c}\vec A(\vec
x_j,t)\,\phi_{k_1,\ldots,k_{(j-1)},(1-k_j),k_{(j+1)},\ldots,
k_n}\right)
%
%
%
\\+\sum_{j=1}^{n}\sigma_j\left(\Psi(\vec x_j,t)-\frac{1}{c}\vec
v(\vec x_j,t)\cdot\vec A(\vec x_j,t)\right)\phi_{k_1,\ldots,
k_n}-V\left(\vec x_1,\ldots,\vec x_n,t\right)\phi_{k_1,\ldots, k_n}\\
\quad\quad\forall\, k_1,\ldots k_n\in\{0,1\}\,.
\end{multline}
In, particular,
\begin{multline}\label{vhfffngghkjgghfjjghghghhjghjgghkghgghjhggjjkgfgdiyfgfjkjgjgSYSPNHHnewlguggggjkkhhjhjh}
i\hbar\left(\frac{\partial}{\partial t}\phi_{0,\ldots,
0}+\sum_{j=1}^{n}\vec v(\vec x_j,t)\cdot\nabla_{\vec
x_j}\phi_{0,\ldots,
0}\right)+\sum_{j=1}^{n}\frac{i\hbar}{2}\left(div_{\vec x_j}\vec
v(\vec x_j,t)\right)\phi_{0,\ldots,
0}\\=\sum_{j=1}^{n}\frac{\hbar}{4}\vec S_j\cdot\left(curl_{\vec
x_j}\vec v(\vec x_j,t)\,\phi_{0,\ldots,
0}\right)-\sum_{j=1}^{n}\frac{(g_j-1)\sigma_j\hbar}{2m_jc}\vec
S_j\cdot\left(curl_{\vec x_j}\vec A(\vec x_j,t)\,\phi_{0,\ldots,
0}\right)\\-\sum_{j=1}^{n}c\,\vec S_j\cdot\left(i\hbar\,\nabla_{\vec
x_j}\phi_{0_1,\ldots,0_{(j-1)},1_j,0_{(j+1)},\ldots,
0_n}+\frac{\sigma_j}{c}\vec A(\vec
x_j,t)\,\phi_{0_1,\ldots,0_{(j-1)},1_j,0_{(j+1)},\ldots, 0_n}\right)
%
%
%
\\+\sum_{j=1}^{n}\sigma_j\left(\Psi(\vec x_j,t)-\frac{1}{c}\vec
v(\vec x_j,t)\cdot\vec A(\vec x_j,t)\right)\phi_{0,\ldots,
0}-V\left(\vec x_1,\ldots,\vec x_n,t\right)\phi_{0,\ldots, 0}\,.
\end{multline}
On the other hand, if $(k_1+\ldots+k_n)\geq 1$ then
$$\sum_{j=1}^{n}m_j\left(1-(-1)^{2+k_j}\right)\geq 2\min{\{m_1,\ldots,m_n\}}>0.$$ Thus, as before in the proof of
\er{vhfffngghkjgghfjjghghghSYShmyuuiiuuhmhmiopoopnniukjhjkhhjl;jkljhjjkllkjhykkJJ},
for
\er{vhfffngghkjgghfjjghghghhjghjgghkghgghjhggjjkgfgdiyfgfjkjgjgSYSPNHHnewlguggggjkkhh}
where $(k_1+\ldots+k_n)\geq 1$, in the non-relativistic limit we
have,
\begin{multline}\label{vhfffngghkjgghfjjghghghhjghjgghkghgghjhggjjkgfgdiyfgfjkjgjgSYSPNHHnewlguggggjkkhhgghjkjhjhmjkh}
\phi_{k_1,\ldots,
k_n}\approx-\sum_{j=1}^{n}\frac{1}{c\left(\sum_{p=1}^{n}m_p\,\left(1-(-1)^{2+k_p}\right)\right)}\Bigg\{\\
\vec S_j\cdot\left(i\hbar\,\nabla_{\vec
x_j}\phi_{k_1,\ldots,k_{(j-1)},(1-k_j),k_{(j+1)},\ldots,
k_n}+\frac{\sigma_j}{c}\vec A(\vec
x_j,t)\,\phi_{k_1,\ldots,k_{(j-1)},(1-k_j),k_{(j+1)},\ldots,
k_n}\right)\Bigg\}\\ \forall\, k_1,\ldots
k_n\in\{0,1\},\;\;(k_1+\ldots+k_n)\geq 1\,.
\end{multline}
Then, using again the mentioned above non-relativistic
approximation, we further approximate
\er{vhfffngghkjgghfjjghghghhjghjgghkghgghjhggjjkgfgdiyfgfjkjgjgSYSPNHHnewlguggggjkkhhgghjkjhjhmjkh}
as:
\begin{multline}\label{vhfffngghkjgghfjjghghghhjghjgghkghgghjhggjjkgfgdiyfgfjkjgjgSYSPNHHnewlguggggjkkhhgghjkjhjhmjkhgjggugu}
\phi_{0_1,\ldots,0_{(j-1)},1_j,0_{(j+1)},\ldots,
0_n}\approx-\frac{1}{2m_jc}\vec S_j\cdot\left(i\hbar\,\nabla_{\vec
x_j}\phi_{0,\ldots,0}+\frac{\sigma_j}{c}\vec A(\vec
x_j,t)\,\phi_{0,\ldots,0}\right)\quad \forall\, j=1,\ldots n\\
\text{and}\quad\quad\quad\quad \phi_{k_1,\ldots, k_n}\approx
0\quad\text{if}\quad k_1+\ldots +k_n\geq 2.
\end{multline}
Thus inserting
\er{vhfffngghkjgghfjjghghghhjghjgghkghgghjhggjjkgfgdiyfgfjkjgjgSYSPNHHnewlguggggjkkhhgghjkjhjhmjkhgjggugu}
into
\er{vhfffngghkjgghfjjghghghhjghjgghkghgghjhggjjkgfgdiyfgfjkjgjgSYSPNHHnewlguggggjkkhhjhjh}
gives
\begin{multline}\label{vhfffngghkjgghfjjghghghhjghjgghkghgghjhggjjkgfgdiyfgfjkjgjgSYSPNHHnewlguggggjkkhhjhjhgugb}
i\hbar\left(\frac{\partial}{\partial t}\phi_{0,\ldots,
0}+\sum_{j=1}^{n}\vec v(\vec x_j,t)\cdot\nabla_{\vec
x_j}\phi_{0,\ldots,
0}\right)+\sum_{j=1}^{n}\frac{i\hbar}{2}\left(div_{\vec x_j}\vec
v(\vec x_j,t)\right)\phi_{0,\ldots, 0}=\\
\sum_{j=1}^{n}\frac{\hbar}{4}\vec S_j\cdot\left(curl_{\vec x_j}\vec
v(\vec x_j,t)\,\phi_{0,\ldots, 0}\right)
+\sum_{j=1}^{n}\frac{1}{2m_j}\,\vec
S_j\cdot\Bigg(i\hbar\,\nabla_{\vec x_j}\left\{\vec
S_j\cdot\left(i\hbar\,\nabla_{\vec
x_j}\phi_{0,\ldots,0}+\frac{\sigma_j}{c}\vec A(\vec
x_j,t)\,\phi_{0,\ldots,0}\right)\right\}\\+\frac{\sigma_j}{c}\vec
A(\vec x_j,t)\,\left\{\vec S_j\cdot\left(i\hbar\,\nabla_{\vec
x_j}\phi_{0,\ldots,0}+\frac{\sigma_j}{c}\vec A(\vec
x_j,t)\,\phi_{0,\ldots,0}\right)\right\}\Bigg)-\sum_{j=1}^{n}\frac{(g_j-1)\sigma_j\hbar}{2m_jc}\vec
S_j\cdot\left(curl_{\vec x_j}\vec A(\vec x_j,t)\,\phi_{0,\ldots,
0}\right)
%
%
%
\\+\sum_{j=1}^{n}\sigma_j\left(\Psi(\vec x_j,t)-\frac{1}{c}\vec
v(\vec x_j,t)\cdot\vec A(\vec x_j,t)\right)\phi_{0,\ldots,
0}-V\left(\vec x_1,\ldots,\vec x_n,t\right)\phi_{0,\ldots, 0}\,.
\end{multline}
that we rewrite as a non-relativistic Shr\"{o}dinger-Pauli equation,
that we studied above in
\er{vhfffngghkjgghfjjghghghhjghjgghkghgghjhggjjkgfgdiyfgfjkjgjgSYSPNHH}:
\begin{multline}\label{vhfffngghkjgghfjjghghghhjghjgghkghgghjhggjjkgfgdiyfgfjkjgjgSYSPNHHpklk}
i\hbar\left(\frac{\partial}{\partial t}\phi_{0,\ldots,
0}+\sum_{j=1}^{n}\vec v(\vec x_j,t)\cdot\nabla_{\vec
x_j}\phi_{0,\ldots,
0}\right)+\sum_{j=1}^{n}\frac{i\hbar}{2}\left(div_{\vec x_j}\vec
v(\vec x_j,t)\right)\phi_{0,\ldots, 0}\\=
-\sum_{j=1}^{n}\frac{\hbar^2}{2m_j}\Delta_{\vec x_j}\phi_{0,\ldots,
0}-V\left(\vec x_1,\ldots,\vec x_n,t\right)\phi_{0,\ldots,
0}\\+\sum_{j=1}^{n}\frac{ i\hbar\sigma_j}{2m_jc}div_{\vec
x_j}\left\{\phi_{0,\ldots, 0}\vec A(\vec
x_j,t)\right\}+\sum_{j=1}^{n}\frac{ i\hbar\sigma_j}{2m_jc}\vec
A(\vec x_j,t)\cdot\nabla_{\vec x_j}\phi_{0,\ldots,
0}\\+\sum_{j=1}^{n}\left(\sigma_j\Psi(\vec
x_j,t)-\frac{\sigma_j}{c}\vec A(\vec x_j,t)\cdot\vec v(\vec
x_j,t)+\frac{\sigma^2_j}{2m_jc^2}\left|\vec A(\vec
x_j,t)\right|^2\right)\phi_{0,\ldots,
0}\\-\sum_{j=1}^{n}\frac{g_j\sigma_j\hbar}{2m_jc}\vec
S_j\cdot\left(curl_{\vec x_j}\vec A(\vec x_j,t)\,\phi_{0,\ldots,
0}\right)+\sum_{j=1}^{n}\frac{\hbar}{4}\vec S_j\cdot\left(curl_{\vec
x_j}\vec v(\vec x_j,t)\,\phi_{0,\ldots, 0}\right).
\end{multline}

Next, consider the relativistic-like Lagrangian of the motion of
system of $n$ spin-half quantum micro-particles with inertial masses
$m_1,\ldots, m_n$ and the charges $\sigma_1,\ldots,\sigma_n$ in the
outer gravitational and electromagnetical fields with
characteristics $\vec v(\vec x,t)$, $\vec A(\vec x,t)$ and
$\Psi(\vec x,t)$ and additional conservative field with potential
$V(\vec y_1,\ldots,\vec y_n,t)$. Then consider a Lagrangian density
$L$ defined by
\begin{multline}\label{vhfffngghkjgghDDmmkkksmfggffgKKnew}
L\left(\psi,\vec A,\Psi,\vec v,\vec x_1,\ldots,\vec x_n,t\right):=\\
\frac{i\hbar}{2}\left(\left(\frac{\partial\psi}{\partial
t}+\sum\limits_{k=1}^{n}\vec v(\vec x_k,t)\cdot\nabla_{\vec
x_k}\psi\right)\cdot\bar\psi-\psi\cdot\left(\frac{\partial\bar\psi}{\partial
t}+\sum\limits_{k=1}^{n}\vec v(\vec x_k,t)\cdot\nabla_{\vec
x_k}\bar\psi\right)\right)\\
+\sum\limits_{k=1}^{n}\frac{c}{2}\left(\vec
D_k\cdot\left(i\hbar\nabla_{\vec x_k}\psi+\frac{\sigma_k}{c}\vec
A(\vec x_k,t)\psi\right)\right)\cdot\bar\psi-\psi\cdot\left(\vec
D_k\cdot\left(i\hbar\nabla_{\vec x_k}\bar\psi-\frac{\sigma_k}{c}\vec
A(\vec x_k,t)\bar\psi\right)\right)\\
+V\left(\vec x_1,\ldots,\vec x_n,t\right)\psi\cdot\bar\psi
-\sum\limits_{k=1}^{n}m_kc^2\left(D^k_4\cdot\psi\right)\cdot\bar\psi-\sum\limits_{k=1}^{n}\sigma_k\left(\Psi(\vec
x_k,t)-\frac{1}{c}\vec v(\vec x_k,t)\cdot\vec A(\vec
x_k,t)\right)\psi\cdot\bar\psi
\\+\sum_{k=1}^{n}\frac{(g_k-1)\sigma_k\hbar}{2m_kc}\left(\left(\vec
M_k\cdot curl_{\vec x_k}\vec A(\vec
x_k,t)\right)\cdot\psi\right)\cdot\bar\psi-\sum\limits_{k=1}^{n}\frac{\hbar}{4}\left(\left(\vec
M_k\cdot curl_{\vec x_k}\vec v(\vec
x_k,t)\right)\cdot\psi\right)\cdot\bar\psi.
\end{multline}
where $\psi:=\psi(\vec x_1,\ldots\vec x_n,t)\in \mathbb{C}^{4^n}$ is
a wave function of the system.
Then, as before, we can get that the density $L$ is invariant under
the change of inertial or non-inertial cartesian coordinate systems
of the form
\begin{equation*}
\begin{cases}
t'=t\\
\vec x'=A(t)\cdot\vec x+\vec z(t)\\
\vec x'_k=A(t)\cdot\vec x_k+\vec z(t)\quad\quad\forall
k=1,\ldots,n,\end{cases}
\end{equation*}
provided, we take
\er{yuythfgfyftydtydtydtyddyyyhhddhhhredPPN111hgghjgintintrrZZHHnew}
into account. Next we investigate stationary points of the
functional
\begin{equation}\label{btfffygtgyggyDDmmkkksmKKnew}
J(\psi)=\int_0^T\int_{\left(\mathbb{R}^3\right)^n}L\left(\psi,\vec
A,\Psi,\vec v,\vec x_1,\ldots,\vec x_n,t\right)d\vec x_1\ldots,d\vec
x_n dt.
\end{equation}
Then
\begin{multline}\label{vhfffngghkjgghDDmmkkkghgsmKKnew}
0=\frac{\delta L}{\delta (\bar\psi)}=
i\hbar\left(\frac{\partial\psi}{\partial
t}+\sum\limits_{k=1}^{n}\frac{1}{2}\vec v(\vec
x_k,t)\cdot\nabla_{\vec
x_k}\psi+\sum\limits_{k=1}^{n}\frac{1}{2}div_{\vec
x_k}\left\{\psi\vec v(\vec x_k,t)\right\}\right)\\
+\sum\limits_{k=1}^{n}c\vec D_k\cdot\left(i\hbar\nabla_{\vec
x_k}\psi+\frac{\sigma_k}{c}\vec A(\vec
x_k,t)\psi\right)-\sum\limits_{k=1}^{n}m_kc^2D^k_4\cdot\psi
+V\left(\vec x_1,\ldots,\vec x_n,t\right)\psi
\\-\sum\limits_{k=1}^{n}\sigma_k\left(\Psi(\vec x_k,t)-\frac{1}{c}\vec
v(\vec x_k,t)\cdot\vec A(\vec x_k,t)\right)\psi
-\sum\limits_{k=1}^{n}\frac{\hbar}{4}\left(\vec M_k\cdot curl_{\vec
x_k}\vec v(\vec x_k,t)\right)\cdot\psi
\\+\sum_{k=1}^{n}\frac{(g_k-1)\sigma_k\hbar}{2m_kc}\left(\vec
M_k\cdot curl_{\vec x_k}\vec A(\vec x_k,t)\right)\cdot\psi ,
\end{multline}
and
\begin{multline}\label{vhfffngghkjgghDDmmkkkghguhyuysmKKnew}
0=\frac{\delta L}{\delta (\psi)}= \bar
{(i)}\hbar\left(\frac{\partial\bar\psi}{\partial
t}+\sum\limits_{k=1}^{n}\frac{1}{2}\vec v(\vec
x_k,t)\cdot\nabla_{\vec
x_k}\bar\psi+\sum\limits_{k=1}^{n}\frac{1}{2}div_{\vec
x_k}\left\{\bar\psi\vec v(\vec x_k,t)\right\}\right)\\
+\sum\limits_{k=1}^{n}c\vec D_k\cdot\left(\bar
{(i)}\hbar\nabla_{\vec x_k}\bar\psi+\frac{\sigma_k}{c}\vec A(\vec
x_k,t)\bar\psi\right)-\sum\limits_{k=1}^{n}m_kc^2D^k_4\cdot\bar\psi
+V\left(\vec x_1,\ldots,\vec x_n,t\right)\bar\psi
\\-\sum\limits_{k=1}^{n}\sigma_k\left(\Psi(\vec x_k,t)-\frac{1}{c}\vec
v(\vec x_k,t)\cdot\vec A(\vec x_k,t)\right)\bar\psi
-\sum\limits_{k=1}^{n}\frac{\hbar}{4}\left(\vec M_k\cdot curl_{\vec
x_k}\vec v(\vec x_k,t)\right)\cdot\bar\psi\\
+\sum_{k=1}^{n}\frac{(g_k-1)\sigma_k\hbar}{2m_kc}\left(\vec
M^T_k\cdot curl_{\vec x_k}\vec A(\vec x_k,t)\right)\cdot\bar\psi.
\end{multline}
Equation \er{vhfffngghkjgghDDmmkkkghguhyuysmKKnew} is just a complex
conjugate of equation \er{vhfffngghkjgghDDmmkkkghgsmKKnew}. Thus the
Euler-Lagrange for \er{btfffygtgyggyDDmmkkksmKKnew} coincides with
the Dirac equation
\er{vhfffngghkjgghfjjghghghhjghjgghkghgghjhggjjkgfgdiyfgfjkjgjgSYSPNHHnew}.

Finally, assume that the first and the second particles have the
same mass $m_1=m_2$ and the same charge $\sigma_1=\sigma_2$ and
moreover, assume that we have
$$V\left(\vec x_1,\vec x_2,\vec x_3,\ldots,\vec
x_n,t\right)=V\left(\vec x_2,\vec x_1,\vec x_3,\ldots,\vec
x_n,t\right)\quad\quad\forall\,\vec x_1,\vec x_2,\vec
x_3,\ldots,\vec x_n\in\mathbb{R}^3,\quad\forall t.$$ In this case it
can be easily deduced that if $\psi(\vec x_1,\vec x_2,\vec
x_3,\ldots,\vec x_n,t)\in\mathbb{C}^{4^n}$ is a solution of
\er{vhfffngghkjgghfjjghghghhjghjgghkghgghjhggjjkgfgdiyfgfjkjgjgSYSPNHHnew},
then $B_{1,2}\cdot\psi(\vec x_2,\vec x_1,\vec x_3,\ldots,\vec
x_n,t)$ is also a solution of
\er{vhfffngghkjgghfjjghghghhjghjgghkghgghjhggjjkgfgdiyfgfjkjgjgSYSPNHHnew},
where by $B_{1,2}:\mathbb{C}^{4^n}\to \mathbb{C}^{4^n}$ we denote
the linear operator (matrix) defined as:
\begin{equation}\label{fyfgjguyiuiHHgjgjgghhHHkbhbbnew}
B_{1,2}\cdot\left(b_1\otimes b_2\otimes b_3\otimes\ldots\otimes
b_n\right)=\left(b_2\otimes b_1\otimes b_3\otimes\ldots\otimes
b_n\right)\quad\quad\forall\, b_1,\ldots b_n\in\mathbb{C}^{4}.
\end{equation}
Therefore, if $\psi(\vec x_1,\vec x_2,\vec x_3,\ldots,\vec
x_n,t)\in\mathbb{C}^{4^n}$ is a solution of
\er{vhfffngghkjgghfjjghghghhjghjgghkghgghjhggjjkgfgdiyfgfjkjgjgSYSPNHHnew}
then for every $t\geq 0$ we will have
\begin{equation}\label{fyfgjguyiuiHHgjgjgghbmgjgfhfhhHHnew}
B_{1,2}\cdot\psi(\vec x_2,\vec x_1,\vec x_3,\ldots,\vec
x_n,t)=-\psi(\vec x_1,\vec x_2,\vec x_3,\ldots,\vec
x_n,t)\quad\quad\forall\,\vec x_1,\vec x_2,\vec x_3,\ldots,\vec
x_n\in\mathbb{R}^3,
\end{equation}
provided that \er{fyfgjguyiuiHHgjgjgghbmgjgfhfhhHHnew} holds for the
initial instant of time $t=0$. So we have a consistency with the
Pauli Exclusion Principle for two or more identical fermions.

\subsection{Quantum Liouville's equation for a finite system of
spin-half relativistic-like particles arising from the Dirac
equation}\label{hilghjhghfdgdg1133} Consider the
\underline{statistical} description of the motion of a system of $n$
spin-half relativistic-like stable micro-particles with inertial
masses $m_1,\ldots,m_n$ and the charges $\sigma_1,\ldots,\sigma_n$
in the outer gravitational and electromagnetical field with
characteristics $\vec v(\vec x,t)$, $\vec A(\vec x,t)$ and
$\Psi(\vec x,t)$ and additional conservative field with potential
$V(\vec y_1,\ldots,\vec y_n,t)$, taking into account the spin
interaction. Then, as before, the Quantum Liouville's equation for
this system of particles has the following form:
\begin{multline}\label{vhfffngghkjgghfjjghghghhjghjgghkghggkghghjghSYSPNjj1133}
i\hbar\frac{\partial\xi}{\partial t}\left(\vec x_1,\ldots,\vec
x_n,\vec y_1,\ldots,\vec y_n,t\right)=\\ \left(\hat H_0\left(\vec
x_1,\ldots,\vec x_n,t\right)\otimes I\right)\cdot\xi\left(\vec
x_1,\ldots,\vec x_n,\vec y_1,\ldots,\vec
y_n,t\right)\\-\left(I\otimes\hat H^*_0\left(\vec y_1,\ldots,\vec
y_n,t\right)\right)\cdot\xi\left(\vec x_1,\ldots,\vec x_n,\vec
y_1,\ldots,\vec y_n,t\right),
\end{multline}
where $\xi\left(\vec x_1,\ldots,\vec x_n,\vec y_1,\ldots,\vec
y_n,t\right)\in\mathbb{C}^{4^n}\otimes \mathbb{C}^{4^n}$ is a
density-matrix function, $\hat H_0\left(\vec x_1,\ldots,\vec
x_n,t\right)$ is the Hamiltonian operator acting on the variables
$\left(\vec x_1,\ldots,\vec x_n\right)$, $\hat H^*_0\left(\vec
y_1,\ldots,\vec y_n,t\right)$ is the \underline{complex} conjugate
(not the Hermitian adjoint)  to the Hamiltonian operator acting on
the variables $\left(\vec y_1,\ldots,\vec y_n\right)$ and $\hat
H_0\left(\vec x_1,\ldots,\vec x_n,t\right)\otimes I$ and
$I\otimes\hat H^*_0\left(\vec y_1,\ldots,\vec y_n,t\right)$ are
linear operators acting on functions\\ $\xi\left(\vec
x_1,\ldots,\vec x_n,\vec y_1,\ldots,\vec
y_n,t\right)\in\mathbb{C}^{4^n}\otimes \mathbb{C}^{4^n}$, defined by
\begin{multline}\label{fyfgjguyiuiHHgjgjgghbmgjgfhfhhHHjhhhjhhkjkjk33}
\left(\hat H_0\left(\vec x_1,\ldots,\vec x_n,t\right)\otimes
I\right)\cdot\left(\psi_1\left(\vec x_1,\ldots,\vec
x_n,t\right)\otimes \psi_2\left(\vec y_1,\ldots,\vec
y_n,t\right)\right)=\\ \left(\hat H_0\left(\vec x_1,\ldots,\vec
x_n,t\right)\cdot\psi_1\left(\vec x_1,\ldots,\vec
x_n,t\right)\right)\otimes
\psi_2\left(\vec y_1,\ldots,\vec y_n,t\right) \quad\text{and}\quad\\
\left(I\otimes\hat H^*_0\left(\vec y_1,\ldots,\vec
y_n,t\right)\right)\cdot\left(\psi_1\left(\vec x_1,\ldots,\vec
x_n,t\right)\otimes \psi_2\left(\vec y_1,\ldots,\vec
y_n,t\right)\right)=\\ \psi_1\left(\vec x_1,\ldots,\vec
x_n,t\right)\otimes \left(\hat H^*_0\left(\vec y_1,\ldots,\vec
y_n,t\right)\cdot\psi_2\left(\vec y_1,\ldots,\vec
y_n,t\right)\right)\quad\quad\forall\psi_1,\psi_2\in\mathbb{C}^{4^n}.
\end{multline}
Since by
\er{vhfffngghkjgghfjjghghghhjghjgghkghgghjhggjjkgSYSPNHHnew} the
Hamiltonian operator $\hat H_0$ has the forms:
\begin{multline}\label{vhfffngghkjgghfjjghghghhjghjgghkghgghjhggjjkgSYSPNjj1133}
\hat H_0\left(\vec x_1,\ldots,\vec x_n,t\right)\cdot\psi\left(\vec
x_1,\ldots,\vec x_n,t\right)=\\
\sum_{j=1}^{n}m_jc^2D^j_4\cdot\psi-\sum_{j=1}^{n}c\,\vec
D_j\cdot\left(i\hbar\nabla_{\vec x_j}\psi+\frac{\sigma_j}{c}\vec
A(\vec x_j,t)\psi\right)
%
%
%
+\sum_{j=1}^{n}\sigma_j\left(\Psi(\vec x_j,t)-\frac{1}{c}\vec v(\vec
x_j,t)\cdot\vec A(\vec x_j,t)\right)\psi\\-V\left(\vec
x_1,\ldots,\vec
x_n,t\right)\psi-\sum_{j=1}^{n}\frac{i\hbar}{2}div_{\vec
x_j}\left\{\psi\vec v(\vec
x_j,t)\right\}-\sum_{j=1}^{n}\frac{i\hbar}{2}\nabla_{\vec
x_j}\psi\cdot\vec v(\vec x_j,t)+\sum_{j=1}^{n}\frac{\hbar}{4}\vec
M_j\cdot\left(curl_{\vec x_j}\vec v(\vec x_j,t)\,\psi\right)
\\-\sum_{j=1}^{n}\frac{(g_j-1)\sigma_j\hbar}{2m_jc}\vec
M_j\cdot\left(curl_{\vec x_j}\vec A(\vec x_j,t)\,\psi\right),
\end{multline}
and consistently with
\er{vhfffngghkjgghfjjghghghhjghjgghkghgghjhggjjkgSYSPNjj1133} we
write
\begin{multline}\label{vhfffngghkjgghfjjghghghhjghjgghkghgghjhggjjkgSYSPNjjklkl1133}
\hat H^*_0\left(\vec y_1,\ldots,\vec y_n,t\right)\cdot\psi\left(\vec
y_1,\ldots,\vec y_n,t\right)=\\
\sum_{j=1}^{n}m_jc^2D^j_4\cdot\psi-\sum_{j=1}^{n}c\,\vec {\bar
D}_j\cdot\left(-i\hbar\nabla_{\vec y_j}\psi+\frac{\sigma_j}{c}\vec
A(\vec y_j,t)\psi\right)
%
%
%
+\sum_{j=1}^{n}\sigma_j\left(\Psi(\vec y_j,t)-\frac{1}{c}\vec v(\vec
y_j,t)\cdot\vec A(\vec y_j,t)\right)\psi\\-V\left(\vec
y_1,\ldots,\vec
y_n,t\right)\psi+\sum_{j=1}^{n}\frac{i\hbar}{2}div_{\vec
y_j}\left\{\psi\vec v(\vec
y_j,t)\right\}+\sum_{j=1}^{n}\frac{i\hbar}{2}\nabla_{\vec
y_j}\psi\cdot\vec v(\vec y_j,t)+\sum_{j=1}^{n}\frac{\hbar}{4}\vec
{\bar M}_j\cdot\left(curl_{\vec y_j}\vec v(\vec y_j,t)\,\psi\right)
\\-\sum_{j=1}^{n}\frac{(g_j-1)\sigma_j\hbar}{2m_jc}\vec
{\bar M}_j\cdot\left(curl_{\vec y_j}\vec A(\vec y_j,t)\,\psi\right),
\end{multline}
where $\vec {\bar D}_j$ and $\vec {\bar M}_j$ are the the
\underline{complex} conjugate (not the Hermitian adjoint) of the
operators $\vec {D}_j$ and $\vec {M}_j$. Thus we rewrite the
corresponding Quantum Liouville's equation
\er{vhfffngghkjgghfjjghghghhjghjgghkghggkghghjghSYSPNjj1133} as:
%
%
%
%
%
%
\begin{multline}\label{vhfffngghkjgghfjjghghghhjghjgghkghgghjhggjjkgfgdiyfgfjkjgjgSYSPNjj1133}
i\hbar\left(\frac{\partial\xi}{\partial t}+\sum_{j=1}^{n}\vec v(\vec
x_j,t)\cdot\nabla_{\vec x_j}\xi+\sum_{j=1}^{n}\vec v(\vec
y_j,t)\cdot\nabla_{\vec
y_j}\xi\right)+\sum_{j=1}^{n}\frac{i\hbar}{2}\left(div_{\vec
x_j}\vec v(\vec x_j,t)+div_{\vec y_j}\vec v(\vec y_j,t)\right)\xi=\\
\sum_{j=1}^{n}\frac{\hbar}{4}(\vec M_j\otimes
I)\cdot\left(curl_{\vec x_j}\vec v(\vec
x_j,t)\,\xi\right)-\sum_{j=1}^{n}\frac{\hbar}{4} (I\otimes \vec{\bar
M}_j)\cdot\left(curl_{\vec y_j}\vec v(\vec y_j,t)\,\xi\right)\\
+\sum_{j=1}^{n}m_jc^2\left((D^j_4\otimes I)-(I\otimes
D^j_4)\right)\cdot\xi-\sum_{j=1}^{n}c\,(\vec D_j\otimes
I)\cdot\left(i\hbar\nabla_{\vec x_j}\xi+\frac{\sigma_j}{c}\vec
A(\vec x_j,t)\xi\right)\\-\sum_{j=1}^{n}c\,(I\otimes \vec {\bar
D}_j)\cdot\left(i\hbar\nabla_{\vec y_j}\xi-\frac{\sigma_j}{c}\vec
A(\vec y_j,t)\xi\right)
%
-\left(V\left(\vec x_1,\ldots,\vec x_n,t\right)-V\left(\vec
y_1,\ldots,\vec
y_n,t\right)\right)\xi\\+\sum_{j=1}^{n}\left(\sigma_j\Psi(\vec
x_j,t)-\frac{\sigma_j}{c}\vec A(\vec x_j,t)\cdot\vec v(\vec
x_j,t)\right)\xi- \sum_{j=1}^{n}\left(\sigma_j\Psi(\vec
y_j,t)-\frac{\sigma_j}{c}\vec A(\vec y_j,t)\cdot\vec v(\vec
y_j,t)\right)\xi\\-\sum_{j=1}^{n}\frac{(g_j-1)\sigma_j\hbar}{2mc}(\vec
M_j\otimes I)\cdot\left(curl_{\vec x_j}\vec A(\vec
x_j,t)\,\xi\right)
+\sum_{j=1}^{n}\frac{(g_j-1)\sigma_j\hbar}{2mc}(I\otimes \vec{\bar
M}_j)\cdot\left(curl_{\vec y_j}\vec A(\vec y_j,t)\,\xi\right),
\end{multline}
where, as before,
\begin{multline}\label{fyfgjguyiuiHHHHHHkjjjjljhjhhh33}
(D_4^j\otimes I)\cdot(a\otimes b)=(D_4^j\cdot a)\otimes
b\quad\text{and}\quad(I\otimes D_4^j)\cdot(a\otimes b)=a\otimes
(D_4^j\cdot b)\quad\forall a\in\mathbb{C}^{4^n},\;\forall
b\in\mathbb{C}^{4^n},\\
(\vec D_j\otimes I)\cdot(a\otimes b)=(\vec D_j\cdot a)\otimes
b\quad\text{and}\quad(I\otimes\vec D_j)\cdot(a\otimes b)=a\otimes
(\vec D_j\cdot b)\quad\forall a\in\mathbb{C}^{4^n},\;\forall
b\in\mathbb{C}^{4^n},
\\
(\vec M_j\otimes I)\cdot(a\otimes b)=(\vec M_j\cdot a)\otimes
b\quad\text{and}\quad(I\otimes\vec M_j)\cdot(a\otimes b)=a\otimes
(\vec M_j\cdot b)\quad\forall a\in\mathbb{C}^{4^n},\;\forall
b\in\mathbb{C}^{4^n}.
\end{multline}
Next consider a change of some non-inertial cartesian coordinate
system $(*)$ to another cartesian coordinate system $(**)$ of the
form \er{noninchgravortbstrjgghguittu2}:
\begin{equation}\label{noninchgravortbstrghgggSYSPNjj1133}
\begin{cases}
\vec x'=A(t)\cdot\vec x+\vec z(t),\\
t'=t,
\end{cases}
\end{equation}
where $A(t)\in SO(3)$ is a rotation. Then, as before in Theorem
\ref{gjghghgghgintintrrZZHHnew}, we deduce that the Quantum
Liouville's equation equation of the form
\er{vhfffngghkjgghfjjghghghhjghjgghkghgghjhggjjkgfgdiyfgfjkjgjgSYSPNjj1133}
is invariant under the change of non-inertial cartesian coordinate
system, provided we have
\begin{equation}\label{vyfgjhgjhvhgghSYSPNjj1133}
\begin{cases}
\vec x'_j=A(t)\cdot\vec x_j+\vec z(t)\quad\forall j=1,\ldots,n\\
\vec y'_j=A(t)\cdot\vec y_j+\vec z(t)\quad\forall j=1,\ldots,n\\
\xi'\left(\vec x'_1,\ldots,\vec x'_n,\vec y'_1,\ldots,\vec
y'_n,t'\right)=\\
\left(\left(W(t)\otimes_{1}W(t)\ldots\otimes_{(n-1)}W(t)
\right)\otimes\left(\bar W(t)\otimes_{1}\bar
W(t)\ldots\otimes_{(n-1)}\bar W(t) \right)\right)\cdot\xi\left(\vec
x_1,\ldots,\vec x_n,\vec y_1,\ldots,\vec y_n,t\right)
\\
V'\left(\vec y'_1,\ldots,\vec y'_n,t'\right)=V\left(\vec
y_1,\ldots,\vec y_n,t\right)
\\
\vec A'=A(t)\cdot\vec A
\\
\Psi'-\frac{1}{c}\vec A'\cdot\vec v'=\Psi-\frac{1}{c}\vec A\cdot\vec
v,
\end{cases}
\end{equation}
with $4\times 4$ matrix $W(t)$ defined by
\begin{equation}\label{ghgyfftdtdljkljnnewnnnewllluuyu}
W(t):=\left(\begin{matrix}U(t)&O^{2\times 2}\\O^{2\times
2}&U(t)\end{matrix}\right)=U(t)\otimes I^{2\times 2},
\end{equation}
where, as before, $U(t)\in SU(2)$
is characterized by:
\begin{equation}\label{gyfyfgfgfgghZZHHhugg33}
U^*(t)\cdot\vec S\cdot U(t)=A(t)\cdot\vec S,
\end{equation}
where
$\vec S:=(S_1,S_2,S_3)$ with Pauli matrices defined by
\er{fyfgjguyiuiHHHHHHkjjjjljkjkjknew}, and $\bar W(t)$ is the the
\underline{complex} conjugate (not the Hermitian adjoint) of the
matrix $W(t)$.

Next assume that $\hat A\left(\vec x_1,\ldots,\vec x_n,t\right)$ is
some Hermitian operator acting on the functions\\ $\psi\left(\vec
x_1,\ldots,\vec x_n,t\right)\in \mathbb{C}^{4^n}$ with respect to
spatial variables $\left(\vec x_1,\ldots,\vec x_n\right)$. Then the
average $\tilde A(t)$ of $\hat A$ on the density matrix $\xi$ is
defined by:
\begin{equation}\label{gyfyfgfgfgghZZHHhuggkjhjjggaa}
\tilde A(t)=\frac{\int trace\left(\left(\hat A\left(\vec
x_1,\ldots,\vec x_n,t\right)\otimes
I\right)\cdot\xi\right)\left(\vec z_1,\ldots,\vec z_n,\vec
z_1,\ldots,\vec z_n,t\right)d\vec z_1\ldots d\vec z_n}{\int
trace\left(\xi\right)\left(\vec z_1,\ldots,\vec z_n,\vec
z_1,\ldots,\vec z_n,t\right)d\vec z_1\ldots d\vec z_n},
\end{equation}
where for given $R\in \mathbb{C}^{p}\otimes \mathbb{C}^{p}$
$trace(R)$ is a linear functional from $\mathbb{C}^{p}\otimes
\mathbb{C}^{p}$ to $\mathbb{C}$ defined by
\begin{equation}\label{fyfgjguyiuiHHHHHHkjjjjljhjhhhbbbgghgaa}
trace(a\otimes b)=\sum_{k=1}^{p}a_kb_k\quad\forall a=(a_1,\ldots
a_p)\in\mathbb{C}^{p},\;\forall b=(b_1,\ldots b_p)\in\mathbb{C}^{p}.
\end{equation}
On the other hand, using the fact that $\hat A$ is Hermitian, it can
be easily deduced that in addition to
\er{gyfyfgfgfgghZZHHhuggkjhjjggaa} we have the following identity:
\begin{multline}\label{gyfyfgfgfgghZZHHhuggkjhjjggjhhhaa}
\int trace\left(\left(\hat A\left(\vec x_1,\ldots,\vec
x_n,t\right)\otimes I\right)\cdot\xi\right)\left(\vec
z_1,\ldots,\vec z_n,\vec z_1,\ldots,\vec z_n,t\right)d\vec z_1\ldots
d\vec z_n=\\ \int trace\left(\left(I\otimes\hat A^*\left(\vec
y_1,\ldots,\vec y_n,t\right)\right)\cdot\xi\right)\left(\vec
z_1,\ldots,\vec z_n,\vec z_1,\ldots,\vec z_n,t\right)d\vec z_1\ldots
d\vec z_n,
\end{multline}
where $\hat A^*\left(\vec y_1,\ldots,\vec y_n,t\right)$ is the
\underline{complex} conjugate (not the Hermitian adjoint)  to the
operator $\hat A$ acting on the variables $\left(\vec
y_1,\ldots,\vec y_n\right)$ and so,
\begin{equation}\label{gyfyfgfgfgghZZHHhuggkjhjjggjhhh99889aa}
\tilde A(t)=\frac{\int trace\left(\left(I\otimes\hat A^*\left(\vec
y_1,\ldots,\vec y_n,t\right)\right)\cdot\xi\right)\left(\vec
z_1,\ldots,\vec z_n,\vec z_1,\ldots,\vec z_n,t\right)d\vec z_1\ldots
d\vec z_n}{\int trace\left(\xi\right)\left(\vec z_1,\ldots,\vec
z_n,\vec z_1,\ldots,\vec z_n,t\right)d\vec z_1\ldots d\vec z_n}.
\end{equation}
In particular, by inserting \er{gyfyfgfgfgghZZHHhuggkjhjjggjhhhaa}
in the particular case $\hat A=\hat H_0$ into
\er{vhfffngghkjgghfjjghghghhjghjgghkghggkghghjghSYSPNjj1133} we
deduce:
\begin{equation}\label{gyfyfgfgfgghZZHHhuggkjhjjggjhhhgugghaa}
i\hbar\frac{\partial}{\partial t}\int
trace\left(\xi\right)\left(\vec z_1,\ldots,\vec z_n,\vec
z_1,\ldots,\vec z_n,t\right)d\vec z_1\ldots d\vec z_n=0,
\end{equation}
and so,
\begin{equation}\label{gyfyfgfgfgghZZHHhuggkjhjjggjhhhjhggaa}
\int trace\left(\xi\right)\left(\vec z_1,\ldots,\vec z_n,\vec
z_1,\ldots,\vec z_n,t\right)d\vec z_1\ldots d\vec z_n=\int
trace\left(\xi\right)\left(\vec z_1,\ldots,\vec z_n,\vec
z_1,\ldots,\vec z_n,0\right)d\vec z_1\ldots d\vec z_n.
\end{equation}

Next, by
\er{vhfffngghkjgghfjjghghghhjghjgghkghggkghghjghSYSPNjj1133} we have
\begin{multline}\label{vhfffngghkjgghfjjghghghhjghjgghkghggkghghjghSYSPNjjjhhhjjaa}
i\hbar\frac{\partial\xi}{\partial t}\left(\vec x_1,\ldots,\vec
x_n,\vec y_1,\ldots,\vec y_n,t\right)=\\ \left(\hat H_0\left(\vec
x_1,\ldots,\vec x_n,t\right)\otimes I\right)\cdot\xi\left(\vec
x_1,\ldots,\vec x_n,\vec y_1,\ldots,\vec
y_n,t\right)\\-\left(I\otimes\hat H^*_0\left(\vec y_1,\ldots,\vec
y_n,t\right)\right)\cdot\xi\left(\vec x_1,\ldots,\vec x_n,\vec
y_1,\ldots,\vec y_n,t\right).
\end{multline}
Then, denote
\begin{equation}\label{gyfyfgfgfgghZZHHhuggkjhjjggjhhhjhgg22jjjkjkjjaa}
\xi_1\left(\vec x_1,\ldots,\vec x_n,\vec y_1,\ldots,\vec
y_n,t\right):=P\cdot\bar\xi\left(\vec y_1,\ldots,\vec y_n,\vec
x_1,\ldots,\vec x_n,t\right),
\end{equation}
where $P$ is a linear operator (matrix), acting on vectors in
$\mathbb{C}^{4^n}\otimes \mathbb{C}^{4^n}$ and satisfying:
\begin{equation}\label{fyfgjguyiuiHHHHHHkjjjjljhjhhhjhhjjhaa}
P\cdot(a\otimes b)=b\otimes a\quad\forall
a\in\mathbb{C}^{4^n},\;\forall b\in\mathbb{C}^{4^n},
\end{equation}
and consider
\begin{equation}\label{gyfyfgfgfgghZZHHhuggkjhjjggjhhhjhgg22jjjkjkkljjjkkjjjaa}
\zeta\left(\vec x_1,\ldots,\vec x_n,\vec y_1,\ldots,\vec
y_n,t\right):=\xi\left(\vec x_1,\ldots,\vec x_n,\vec y_1,\ldots,\vec
y_n,t\right)-\xi_1\left(\vec x_1,\ldots,\vec x_n,\vec
y_1,\ldots,\vec y_n,t\right).
\end{equation}
Thus
\begin{multline}\label{vhfffngghkjgghfjjghghghhjghjgghkghggkghghjghSYSPNjjjhhhgggjjaa}
-i\hbar P\cdot \frac{\partial\bar\xi}{\partial t}\left(\vec
y_1,\ldots,\vec y_n,\vec x_1,\ldots,\vec x_n,t\right)=\\
\left(I\otimes\hat H^*_0\left(\vec y_1,\ldots,\vec
y_n,t\right)\right)\cdot\left(P\cdot\bar\xi\left(\vec
y_1,\ldots,\vec y_n,\vec x_1,\ldots,\vec
x_n,t\right)\right)\\-\left(\hat H_0\left(\vec x_1,\ldots,\vec
x_n,t\right)\otimes I\right)\cdot\left(P\cdot\bar\xi\left(\vec
y_1,\ldots,\vec y_n,\vec x_1,\ldots,\vec x_n,t\right)\right),
\end{multline}
and then
\begin{multline}\label{vhfffngghkjgghfjjghghghhjghjgghkghggkghghjghSYSPNjjjhhhhjhjjhjkhjhhjjaa88}
i\hbar\frac{\partial\xi_1}{\partial t}\left(\vec x_1,\ldots,\vec
x_n,\vec y_1,\ldots,\vec y_n,t\right)=\\ \left(\hat H_0\left(\vec
x_1,\ldots,\vec x_n,t\right)\otimes I\right)\cdot\xi_1\left(\vec
x_1,\ldots,\vec x_n,\vec y_1,\ldots,\vec
y_n,t\right)\\-\left(I\otimes\hat H^*_0\left(\vec y_1,\ldots,\vec
y_n,t\right)\right)\cdot\xi_1\left(\vec x_1,\ldots,\vec x_n,\vec
y_1,\ldots,\vec y_n,t\right).
\end{multline}
Therefore,
\begin{multline}\label{vhfffngghkjgghfjjghghghhjghjgghkghggkghghjghSYSPNjjjhhhhjhjjhjkhjhhjjaa}
i\hbar\frac{\partial\zeta}{\partial t}\left(\vec x_1,\ldots,\vec
x_n,\vec y_1,\ldots,\vec y_n,t\right)=\\ \left(\hat H_0\left(\vec
x_1,\ldots,\vec x_n,t\right)\otimes I\right)\cdot\zeta\left(\vec
x_1,\ldots,\vec x_n,\vec y_1,\ldots,\vec
y_n,t\right)\\-\left(I\otimes\hat H^*_0\left(\vec y_1,\ldots,\vec
y_n,t\right)\right)\cdot\zeta\left(\vec x_1,\ldots,\vec x_n,\vec
y_1,\ldots,\vec y_n,t\right).
\end{multline}
Thus if $\zeta$ satisfies initial condition $\zeta\left(\vec
x_1,\ldots,\vec x_n,\vec y_1,\ldots,\vec y_n,0\right)=0$ then
$\zeta\left(\vec x_1,\ldots,\vec x_n,\vec y_1,\ldots,\vec
y_n,t\right)=0$ for every $t>0$. So
\begin{multline}\label{gyfyfgfgfgghZZHHhuggkjhjjggjhhhjhgg22jjjkjkkljjjkkjopiiiojjaa}
\xi\left(\vec x_1,\ldots,\vec x_n,\vec y_1,\ldots,\vec
y_n,0\right)=P\cdot\bar\xi\left(\vec y_1,\ldots,\vec y_n,\vec
x_1,\ldots,\vec x_n,0\right)\quad\text{implies}\quad\\
\xi\left(\vec x_1,\ldots,\vec x_n,\vec y_1,\ldots,\vec
y_n,t\right)=P\cdot\bar\xi\left(\vec y_1,\ldots,\vec y_n,\vec
x_1,\ldots,\vec x_n,t\right)\quad\forall\, t\geq 0.
\end{multline}

Finally, assume that the first and the second particles have the
same mass $m_1=m_2$ and the same charge $\sigma_1=\sigma_2$ and
moreover, assume that we have
$$V\left(\vec x_1,\vec x_2,\vec x_3,\ldots,\vec
x_n,t\right)=V\left(\vec x_2,\vec x_1,\vec x_3,\ldots,\vec
x_n,t\right)\quad\quad\forall\,\vec x_1,\vec x_2,\vec
x_3,\ldots,\vec x_n\in\mathbb{R}^3,\quad\forall t.$$ In this case it
can be easily deduced that if  $\xi\left(\vec x_1,\vec x_2,\vec
x_3,\ldots,\vec x_n,\vec y_1,\vec y_2,\vec y_3,\ldots,\vec
y_n,t\right)\in\mathbb{C}^{4^n}\otimes \mathbb{C}^{4^n}$ is a
solution of
\er{vhfffngghkjgghfjjghghghhjghjgghkghggkghghjghSYSPNjjjhhhjjaa},
then $C_{1,2}\cdot\xi(\vec x_2,\vec x_1,\vec x_3,\ldots,\vec
x_n,\vec y_2,\vec y_1,\vec y_3,\ldots,\vec y_n,t)$
is also a solution of
\er{vhfffngghkjgghfjjghghghhjghjgghkghggkghghjghSYSPNjjjhhhjjaa},
where by $C_{1,2}:\mathbb{C}^{4^n}\otimes \mathbb{C}^{4^n}\to
\mathbb{C}^{4^n}\otimes \mathbb{C}^{4^n}$
we denote the linear operator (matrix) defined as:
\begin{multline}\label{fyfgjguyiuiHHgjgjgghhHHjjhjhaa}
C_{1,2}\cdot\left(\left(a_1\otimes a_2\otimes
a_3\otimes\ldots\otimes a_n\right)\otimes \left(b_1\otimes
b_2\otimes b_3\otimes\ldots\otimes b_n\right)\right)=\\
\left(\left(a_2\otimes a_1\otimes a_3\otimes\ldots\otimes
a_n\right)\otimes \left(b_2\otimes b_1\otimes
b_3\otimes\ldots\otimes b_n\right)\right)\quad\forall\, a_1,\ldots
a_n,b_1,\ldots,b_n\in\mathbb{C}^{4}.
\end{multline}
Therefore, if $\xi\left(\vec x_1,\vec x_2,\vec x_3,\ldots,\vec
x_n,\vec y_1,\vec y_2,\vec y_3,\ldots,\vec
y_n,t\right)\in\mathbb{C}^{4^n}\otimes \mathbb{C}^{4^n}$ is a
solution of
\er{vhfffngghkjgghfjjghghghhjghjgghkghggkghghjghSYSPNjjjhhhjjaa}
then for every $t\geq 0$ we will have
\begin{multline}\label{fyfgjguyiuiHHgjgjgghbmgjgfhfhhHHrraa}
C_{1,2}\cdot\xi(\vec x_2,\vec x_1,\vec x_3,\ldots,\vec x_n,\vec
y_2,\vec y_1,\vec y_3,\ldots,\vec y_n,t)=\xi\left(\vec x_1,\vec
x_2,\vec x_3,\ldots,\vec x_n,\vec y_1,\vec y_2,\vec y_3,\ldots,\vec
y_n,t\right)
\\ \quad\quad\forall\,\vec x_1,\vec x_2,\vec
x_3,\ldots,\vec x_n,\vec y_1,\vec y_2,\vec y_3,\ldots,\vec
y_n\in\mathbb{R}^3,
\end{multline}
provided that \er{fyfgjguyiuiHHgjgjgghbmgjgfhfhhHHrraa} holds for
the initial instant of time $t=0$. So we have a consistency with the
Pauli Exclusion Principle for two or more identical fermions.

Next given $\xi\left(\vec x_1,\ldots,\vec x_n,\vec y_1,\ldots,\vec
y_n,t\right)$ that satisfies
\begin{equation}\label{gyfyfgfgfgghZZHHhuggkjhjjggjhhhjhgg22jjjkjkkljjjkkjopiiiojjkljkjkjjaa}
\xi\left(\vec x_1,\ldots,\vec x_n,\vec y_1,\ldots,\vec
y_n,t\right)=P\cdot\bar\xi\left(\vec y_1,\ldots,\vec y_n,\vec
x_1,\ldots,\vec x_n,t\right)\quad\forall\, t\geq 0,
\end{equation}
define the operator $\hat R_\xi$ by
\begin{multline}\label{gyfyfgfgfgghZZHHhuggkjhjjggjhhhjhgg22jjjkjkkljjjkkjopiiiojjkljkjkjkljkjkjjaa}
\hat R_\xi\left(\vec x_1,\ldots,\vec x_n,t\right)\cdot\psi\left(\vec
x_1,\ldots,\vec x_n,t\right)=\\ \int B\left(\xi\left(\vec
x_1,\ldots,\vec x_n,\vec y_1,\ldots,\vec y_n,t\right),\psi\left(\vec
y_1,\ldots,\vec y_n,t\right)\right)d\vec y_1\ldots d\vec y_n,
\end{multline}
where $B$ is a bilinear mapping, acting on the pairs of vectors in
$\left(\mathbb{C}^{4^n}\otimes \mathbb{C}^{4^n}\right)\times
\mathbb{C}^{4^n}$, taking the values in $\mathbb{C}^{4^n}$ and
having the form
\begin{equation}\label{fyfgjguyiuiHHHHHHkjjjjljhjhhhjhhjjhyhyujkjkaa}
B\left((a\otimes b), c\right)=(b\cdot c)a\quad\forall
a\in\mathbb{C}^{4^n},\;\forall b\in\mathbb{C}^{4^n},\;\forall
c\in\mathbb{C}^{4^n}.
\end{equation}
Then the mapping $\xi\to\hat R_\xi$ is one-to-one. Moreover, by
\er{gyfyfgfgfgghZZHHhuggkjhjjggjhhhjhgg22jjjkjkkljjjkkjopiiiojjkljkjkjjaa}
$\hat R_\xi$ is an Hermitian operator. Finally by
\er{vhfffngghkjgghfjjghghghhjghjgghkghggkghghjghSYSPNjjjhhhjjaa} we
have
\begin{multline}\label{vhfffngghkjgghfjjghghghhjghjgghkghggkghghjghSYSPNjjjhhhjjkjkjklkjljljjaa}
i\hbar\frac{\partial\hat R_\xi}{\partial t}\left(\vec
x_1,\ldots,\vec x_n,t\right)=\\ \hat H_0\left(\vec x_1,\ldots,\vec
x_n,t\right)\cdot\hat R_\xi\left(\vec x_1,\ldots,\vec
x_n,t\right)-\hat R_\xi\left(\vec x_1,\ldots,\vec
x_n,t\right)\cdot\hat H_0\left(\vec x_1,\ldots,\vec x_n,t\right).
\end{multline}
Equation
\er{vhfffngghkjgghfjjghghghhjghjgghkghggkghghjghSYSPNjjjhhhjjkjkjklkjljljjaa}
is equivalent to
\er{vhfffngghkjgghfjjghghghhjghjgghkghggkghghjghSYSPNjjjhhhjjaa}.

Next, clearly,  $\xi$ satisfies
\er{fyfgjguyiuiHHgjgjgghbmgjgfhfhhHHrraa} if and only if we have the
following relation:
\begin{multline}\label{fyfgjguyiuiHHgjgjgghbmgjgfhfhhHHljjjhjhjhyyuyuaa}
B_{1,2}\cdot\psi(\vec x_2,\vec x_1,\vec x_3,\ldots,\vec
x_n,t)=-\psi(\vec x_1,\vec x_2,\vec x_3,\ldots,\vec
x_n,t)\;\;\forall\,\vec x_1,\ldots,\vec
x_n\in\mathbb{R}^3\quad\text{and}\quad \phi=\hat R_\xi\cdot\psi\\
\quad\text{implies}\quad B_{1,2}\cdot\phi(\vec x_2,\vec x_1,\vec
x_3,\ldots,\vec x_n,t)=-\phi(\vec x_1,\vec x_2,\vec x_3,\ldots,\vec
x_n,t)\;\;\forall\,\vec x_1,\ldots,\vec x_n\in\mathbb{R}^3,
\end{multline}
where by $B_{1,2}:\mathbb{C}^{4^n}\to \mathbb{C}^{4^n}$ we denote
the linear operator (matrix) defined as in
\er{fyfgjguyiuiHHgjgjgghhHHkbhbbnew} by the following:
\begin{equation}\label{fyfgjguyiuiHHgjgjgghhHHhuhuhuhaa}
B_{1,2}\cdot\left(a_1\otimes a_2\otimes a_3\otimes\ldots\otimes
a_n\right)=\left(a_2\otimes a_1\otimes a_3\otimes\ldots\otimes
a_n\right)\quad\quad\forall\, a_1,\ldots a_n\in\mathbb{C}^{4}.
\end{equation}

Finally, assume that $\hat A\left(\vec x_1,\ldots,\vec x_n,t\right)$
is some Hermitian operator acting on the functions\\ $\psi\left(\vec
x_1,\ldots,\vec x_n,t\right)\in \mathbb{C}^{4^n}$ with respect to
spatial variables $\left(\vec x_1,\ldots,\vec x_n\right)$. Then, we
remind that the average $\tilde A(t)$ of $\hat A$ on the density
matrix $\xi$ is defined by \er{gyfyfgfgfgghZZHHhuggkjhjjggaa}:
\begin{equation}\label{gyfyfgfgfgghZZHHhuggkjhjjggkhjhjhaa}
\tilde A(t)=\frac{\int trace\left(\left(\hat A\left(\vec
x_1,\ldots,\vec x_n,t\right)\otimes
I\right)\cdot\xi\right)\left(\vec z_1,\ldots,\vec z_n,\vec
z_1,\ldots,\vec z_n,t\right)d\vec z_1\ldots d\vec z_n}{\int
trace\left(\xi\right)\left(\vec z_1,\ldots,\vec z_n,\vec
z_1,\ldots,\vec z_n,t\right)d\vec z_1\ldots d\vec z_n}.
\end{equation}
On the other hand,
\begin{equation}\label{gyfyfgfgfgghZZHHhuggkjhjjgg22hjhjjhjhmnbmbhjhjaa}
trace \left(\hat A\circ \hat R_\xi\right)=\int trace\left(\left(\hat
A\left(\vec x_1,\ldots,\vec x_n,t\right)\otimes
I\right)\cdot\xi\right)\left(\vec z_1,\ldots,\vec z_n,\vec
z_1,\ldots,\vec z_n,t\right)d\vec z_1\ldots d\vec z_n,
\end{equation}
and so,
\begin{equation}\label{gyfyfgfgfgghZZHHhuggkjhjjgg22hjhjjhjhmnbmbaa}
\tilde A(t)=\frac{trace \left(\hat A\circ \hat R_\xi\right)}{trace
\left(\hat R_\xi\right)}=\frac{trace \left( \hat R_\xi\circ \hat
A\right)}{trace \left(\hat R_\xi\right)},
\end{equation}
where by the trace we mean the trace of an operator on a Hilbert
space. Moreover, by \er{gyfyfgfgfgghZZHHhuggkjhjjggjhhhjhggaa} and
\er{gyfyfgfgfgghZZHHhuggkjhjjgg22hjhjjhjhmnbmbhjhjaa} we have
\begin{equation}\label{gyfyfgfgfgghZZHHhuggkjhjjgg22hjhjjhjhmnbmbjkjhjhjhhaa}
trace \left( \hat R_\xi \right)(t)=trace \left( \hat
R_\xi\right)(0).
\end{equation}

\subsubsection{Thermodynamical equilibrium for relativistic-like spin-half particles}
Let $\xi\left(\vec x_1,\ldots,\vec x_n,\vec y_1,\ldots,\vec
y_n,t\right)$ be the density matrix of the quantum system, ruled by
the Hamiltonian operator $\hat H_0$ from
\er{vhfffngghkjgghfjjghghghhjghjgghkghgghjhggjjkgSYSPNjj1133}, in
the case of \underline{thermodynamical equilibrium} and $\hat R_\xi$
is given by
\er{gyfyfgfgfgghZZHHhuggkjhjjggjhhhjhgg22jjjkjkkljjjkkjopiiiojjkljkjkjkljkjkjjaa}
for this $\xi$. Then in the case that $\frac{\partial \hat
H_0}{\partial t}\equiv 0$ and the given equilibrium system rests
macroscopically, i.e. it has the macroscopical velocity field zero:
$\vec u(\vec x,t)\equiv 0$ it is well known that
\begin{equation}\label{vhfffngghkjgghfjjghghghSYSPNjljllkljjhkhkhklpl;lpoopkklkljlrrccnnzz}
\hat R_\xi=\frac{1}{trace \left( f\circ \hat H_0 \right)} f\circ
\hat H_0,
\end{equation}
where $f(z):\mathbb{C}\to \mathbb{C}$ is some holomorphic function
defined as a sum of the power series
\begin{equation}\label{vhfffngghkjgghfjjghghghSYSPNjljllkljjhkhkhklpl;lpoopkklkljlrrcchhkhccnnzz}
f(z):=\sum_{m=0}^{+\infty}a_mz^m,
\end{equation}
with $a_m\in\mathbb{C}$, and the operator $f\circ \hat H_{0}$ is
given by
\begin{equation}\label{vhfffngghkjgghfjjghghghSYSPNjljllkljjhkhkhklpl;lpoopkklkljlrrcchhkhcchjhjgccnnzz}
f\circ \hat H_{0}:=\sum_{m=0}^{+\infty}a_m\hat H^m_{0}.
\end{equation}
We remind that in the case of a \underline{non-relativistic system}
in thermostat, having the Kelvin temperature $T$, we considered the
canonical ensemble:
\begin{equation}\label{vhfffngghkjgghfjjghghghhjghjgghkghggkghghjghSYSPNjjjhhhjjkjkjklkjljljkkhhkjiijjihhjjhjhjhjhkkhhk11cchjhjhcckk}
f(s)=e^{-\frac{s}{kT}}=\sum_{m=0}^{+\infty}\frac{1}{m!}\left(-\frac{s}{kT}\right)^m\quad\forall
s\in\mathbb{C},
\end{equation}
where $k$ is the Boltzmann constant and by $f\circ \hat H_0$ we
denoted the following operator
\begin{equation}\label{vhfffngghkjgghfjjghghghhjghjgghkghggkghghjghSYSPNjjjhhhjjkjkjklkjljljkkhhkjiijjihhjjhjhjhjhkkhhknnhkhh11cchjgjgjcckk}
f\circ \hat
H_0=\sum_{m=0}^{+\infty}\frac{1}{m!}\left(-\frac{1}{kT}\hat
H_0\right)^m\quad\forall s\in\mathbb{C}.
\end{equation}
However, in the \underline{non-relativistic case} the Hamiltonian
operator $\hat H_0$ was bounded from below and then the trace of
$f\circ \hat H_0$ was finite. On the other hand, in the case of the
Dirac system, the infinitely small negative energies are possible
and then the trace of $f\circ \hat H_0$ could be infinite. Thus, we
need to assume in addition that $f(s)$ decays rapidly as
$s\to\pm\infty$. For example, the possible choice is the following:
\begin{equation}\label{vhfffngghkjgghfjjghghghhjghjgghkghggkghghjghSYSPNjjjhhhjjkjkjklkjljljkkhhkjiijjihhjjhjhjhjhkkhhk11cchjhjhcckkjkljklkhjjh}
f(s)=e^{-\frac{s^2}{2\left(\sum_{j=1}^{n}m_j\right)c^2kT}}=\sum_{m=0}^{+\infty}\frac{1}{m!}\left(-\frac{s^2}{2\left(\sum_{j=1}^{n}m_j\right)c^2kT}\right)^m\quad\forall
s\in\mathbb{C},
\end{equation}
where $k$ is the Boltzmann constant and by $f\circ \hat H_0$ we
denote the following operator
\begin{equation}\label{vhfffngghkjgghfjjghghghhjghjgghkghggkghghjghSYSPNjjjhhhjjkjkjklkjljljkkhhkjiijjihhjjhjhjhjhkkhhknnhkhh11cchjgjgjcckkjjkjkjk}
f\circ \hat
H_0=\sum_{m=0}^{+\infty}\frac{1}{m!}\left(-\frac{1}{2\left(\sum_{j=1}^{n}m_j\right)c^2kT}\hat
H^2_0\right)^m\quad\forall s\in\mathbb{C}.
\end{equation}
We would like to find alternative form of the above law of
thermodynamical equilibrium, which is invariant under the change of
inertial or non-inertial cartesian coordinate system. Then it is
clear that if the given system rests in the old coordinate system,
then it obviously has non-trivial macroscopical velocity field $\vec
u(\vec x,t)$ in the new one. Moreover, $\vec u(\vec x,t)$ can depend
on $\vec x$ and $t$, as it indeed happens in the case of a rotation
of the new coordinate system with respect to the old one. On the
other hand the concept of thermodynamical equilibrium is clearly
independent of the coordinate system.

In order to find the forms of the law of thermodynamical
equilibrium, which are indeed invariant under the change of inertial
or non-inertial cartesian coordinate system we follow the steps as
below. Given a speed-like vector field $\vec u=\vec u(\vec x,t)$,
define
\begin{multline}\label{vhfffngghkjgghfjjghghghSYSPNjljll'ljjjjjjhkjhjhjjkjkkjj88}
\hat H_{\vec u}\left(\vec x_1,\ldots,\vec x_n,t\right):=\\
\hat H_0\left(\vec x_1,\ldots,\vec
x_n,t\right)-\frac{1}{2}\left(\sum_{j=1}^{n}\hat {\vec P}_j\cdot\vec
u(\vec x_j,t)+\vec u(\vec x_j,t)\cdot \hat {\vec
P}_j\right)-\sum_{j=1}^{n}\frac{\hbar}{4}\vec
M_j\cdot\left(curl_{\vec x_j}\vec u(\vec x_j,t)\right),
\end{multline}
where $\hat{\vec P}_j$ is operator defined as
\begin{equation}\label{vhfffngghkjgghfjjghghghSYSPNjljll'ljjjjjjhkjhjhjjkjkkljjljjkkjj88}
\hat {\vec P}_j\cdot\psi\left(\vec x_1,\ldots,\vec
x_n\right)=-i\hbar\nabla_{\vec x_j}\psi\left(\vec x_1,\ldots,\vec
x_n\right).
\end{equation}
On the other hand, by
\er{vhfffngghkjgghfjjghghghhjghjgghkghgghjhggjjkgSYSPNjj1133} the
Hamiltonian operator $\hat H_0$ has the forms:
\begin{multline}\label{vhfffngghkjgghfjjghghghhjghjgghkghgghjhggjjkgSYSPNjjjljjkljjkjkjj88kk88}
\hat H_0\left(\vec x_1,\ldots,\vec x_n,t\right)\cdot\psi\left(\vec
x_1,\ldots,\vec x_n,t\right)=\\
\sum_{j=1}^{n}m_jc^2D^j_4\cdot\psi-\sum_{j=1}^{n}c\,\vec
D_j\cdot\left(i\hbar\nabla_{\vec x_j}\psi+\frac{\sigma_j}{c}\vec
A(\vec x_j,t)\psi\right)
%
%
%
+\sum_{j=1}^{n}\sigma_j\left(\Psi(\vec x_j,t)-\frac{1}{c}\vec v(\vec
x_j,t)\cdot\vec A(\vec x_j,t)\right)\psi\\-V\left(\vec
x_1,\ldots,\vec
x_n,t\right)\psi-\sum_{j=1}^{n}\frac{i\hbar}{2}div_{\vec
x_j}\left\{\psi\vec v(\vec
x_j,t)\right\}-\sum_{j=1}^{n}\frac{i\hbar}{2}\nabla_{\vec
x_j}\psi\cdot\vec v(\vec x_j,t)+\sum_{j=1}^{n}\frac{\hbar}{4}\vec
M_j\cdot\left(curl_{\vec x_j}\vec v(\vec x_j,t)\,\psi\right)
\\-\sum_{j=1}^{n}\frac{(g_j-1)\sigma_j\hbar}{2m_jc}\vec
M_j\cdot\left(curl_{\vec x_j}\vec A(\vec x_j,t)\,\psi\right).
\end{multline}
Then by
\er{vhfffngghkjgghfjjghghghhjghjgghkghgghjhggjjkgSYSPNjjjljjkljjkjkjj88kk88}
and \er{vhfffngghkjgghfjjghghghSYSPNjljll'ljjjjjjhkjhjhjjkjkkjj88}
we have
\begin{multline}\label{vhfffngghkjgghfjjghghghhjghjgghkghgghjhggjjkgSYSPNjjjljjkljjkjkjj88}
\hat H_{\vec u}\left(\vec x_1,\ldots,\vec
x_n,t\right)\cdot\psi\left(\vec x_1,\ldots,\vec x_n,t\right)=
\sum_{j=1}^{n}m_jc^2D^j_4\cdot\psi-\sum_{j=1}^{n}c\,\vec
D_j\cdot\left(i\hbar\nabla_{\vec x_j}\psi+\frac{\sigma_j}{c}\vec
A(\vec x_j,t)\psi\right)
%
%
%
\\+\sum_{j=1}^{n}\sigma_j\left(\Psi(\vec x_j,t)-\frac{1}{c}\vec
v(\vec x_j,t)\cdot\vec A(\vec x_j,t)\right)\psi-V\left(\vec
x_1,\ldots,\vec
x_n,t\right)\psi\\-\sum_{j=1}^{n}\frac{i\hbar}{2}div_{\vec
x_j}\left\{\psi\left(\vec v(\vec x_j,t)-\vec u(\vec
x_j,t)\right)\right\}-\sum_{j=1}^{n}\frac{i\hbar}{2}\nabla_{\vec
x_j}\psi\cdot\left(\vec v(\vec x_j,t)-\vec u(\vec
x_j,t)\right)\\+\sum_{j=1}^{n}\frac{\hbar}{4}\vec
M_j\cdot\left(curl_{\vec x_j}\left(\vec v(\vec x_j,t)-\vec u(\vec
x_j,t)\right)\,\psi\right)
-\sum_{j=1}^{n}\frac{(g_j-1)\sigma_j\hbar}{2m_jc}\vec
M_j\cdot\left(curl_{\vec x_j}\vec A(\vec x_j,t)\,\psi\right).
\end{multline}
Thus, as before, we can prove that
\er{vhfffngghkjgghfjjghghghhjghjgghkghgghjhggjjkgSYSPNjjjljjkljjkjkjj88}
is invariant under the change of inertial or non-inertial cartesian
coordinate systems i.e.
\begin{multline}\label{vhfffngghkjgghfjjghghghhjghjgghkghgghjhggjjkgSYSPNjjjljjkljjkjkjjz1188}
\hat H'_{\vec u'}\left(\vec x'_1,\ldots,\vec
x'_n,t'\right)\cdot\psi'\left(\vec x'_1,\ldots,\vec
x'_n,t'\right)=\phi'\left(\vec x'_1,\ldots,\vec x'_n,t'\right)
\quad\text{if and only if}\quad\\  \hat H_{\vec u}\left(\vec
x_1,\ldots,\vec x_n,t\right)\cdot\psi\left(\vec x_1,\ldots,\vec
x_n,t\right)=\phi\left(\vec x_1,\ldots,\vec x_n,t\right),
\end{multline}
provided that, as before in
\er{yuythfgfyftydtydtydtyddyyyhhddhhhredPPN111hgghjgintintrrZZHHnew},
we have
\begin{equation}\label{yuythfgfyftydtydtydtyddyyyhhddhhhredPPN111hgghjgintintrrZZHHhkkhjhj1188}
\begin{cases}
V'\left(\vec x'_1,\ldots,\vec x'_n,t'\right)\,=\,V\left(\vec
x_1,\ldots,\vec x_n,t\right),\\
\sigma'_j\,=\,\sigma_j,\\
m'_j\,=\,m_j,\\
g'_j\,=\,g_j,\\
\vec v'(\vec x',t)\,=\,A(t)\cdot \vec v(\vec x,t)+\frac{dA}{dt}(t)\cdot\vec x+\frac{d\vec z}{dt}(t),\\
\vec u'(\vec x',t)\,=\,A(t)\cdot \vec u(\vec x,t)+\frac{dA}{dt}(t)\cdot\vec x+\frac{d\vec z}{dt}(t),\\
\vec A'(\vec x',t)\,=\,A(t)\cdot \vec A(\vec x,t),\\
\Psi'(\vec x',t)-\vec v'(\vec x',t)\cdot\vec A'(\vec x',t)\,=\,\Psi(\vec x,t)-\vec v(\vec x,t)\cdot\vec A(\vec x,t),\\
\psi'\left(\vec x'_1,\ldots,\vec
x'_n,t'\right)\,=\,\left(W(t)\otimes_{1}W(t)\otimes_{2}W(t)\ldots\otimes_{(n-1)}W(t)
\right)\cdot\psi\left(\vec x_1,\ldots,\vec x_n,t\right),\\
\phi'\left(\vec x'_1,\ldots,\vec
x'_n,t'\right)\,=\,\left(W(t)\otimes_{1}W(t)\otimes_{2}W(t)\ldots\otimes_{(n-1)}W(t)
\right)\cdot\phi\left(\vec x_1,\ldots,\vec x_n,t\right).
\end{cases}
\end{equation}
%
%
%
%
%
%

Next consider
\begin{equation}\label{vhfffngghkjgghfjjghghghSYSPNjljllkljjhkhkhklpl;lpooprrccnnzz}
\hat R_\xi=\frac{1}{trace \left( f\circ \hat H_{\vec u} \right)}
f\circ \hat H_{\vec u},
\end{equation}
where $f(z):\mathbb{C}\to \mathbb{C}$ is some holomorphic function
defined as a sum of the power series
\begin{equation}\label{vhfffngghkjgghfjjghghghSYSPNjljllkljjhkhkhklpl;lpoopkklkljlrrcchhkhcchhhjjkljccnnzz}
f(z):=\sum_{m=0}^{+\infty}a_mz^m,
\end{equation}
with $a_m\in\mathbb{C}$, such that $f(s)$ decays rapidly as
$s\to\pm\infty$. For example, we can take $f(s)$ be defined by
\er{vhfffngghkjgghfjjghghghhjghjgghkghggkghghjghSYSPNjjjhhhjjkjkjklkjljljkkhhkjiijjihhjjhjhjhjhkkhhk11cchjhjhcckkjkljklkhjjh}.
Moreover, consider the operator $f\circ \hat H_{\vec u}$, given by:
\begin{equation}\label{vhfffngghkjgghfjjghghghSYSPNjljllkljjhkhkhklpl;lpoopkklkljlrrcchhkhcchjhjgcckjhhjccnnzz}
f\circ \hat H_{\vec u}:=\sum_{m=0}^{+\infty}a_m\hat H^m_{\vec u}.
\end{equation}
%
%
%
%
%
%
%
%
We would like to note here that by
\er{vhfffngghkjgghfjjghghghhjghjgghkghgghjhggjjkgSYSPNjjjljjkljjkjkjj88}
in the case that vector field $\vec u(\vec x,t)$ satisfies
\begin{equation}\label{btfffygtgyggyijhhkkSYSPNuhuygyygyggyuyyuyuykuhghgjjojiyyyuyuuyyijyyuyuhghghguydd}
\left|\vec u(\vec x,t)-\vec v(\vec
x,t)\right|\,<\,c\quad\quad\quad\quad\forall\, (\vec x,t),
\end{equation}
since $f(s)$ decays rapidly as $s\to\pm\infty$, there exists a
density matrix $\xi_1$, such that the operator $\hat R_{\xi_1}$,
given by
\er{gyfyfgfgfgghZZHHhuggkjhjjggjhhhjhgg22jjjkjkkljjjkkjopiiiojjkljkjkjkljkjkjjaa},
equals to the operator $f\circ \hat H_{\vec u}$ and thus
\er{vhfffngghkjgghfjjghghghSYSPNjljllkljjhkhkhklpl;lpooprrccnnzz}
indeed has sense. Next by
\er{vhfffngghkjgghfjjghghghhjghjgghkghgghjhggjjkgSYSPNjjjljjkljjkjkjjz1188}
and
\er{vhfffngghkjgghfjjghghghSYSPNjljllkljjhkhkhklpl;lpoopkklkljlrrcchhkhcchjhjgcckjhhjccnnzz}
the operator $\hat R_\xi$ in the left hand side of
\er{vhfffngghkjgghfjjghghghSYSPNjljllkljjhkhkhklpl;lpooprrccnnzz} is
invariant under the change of inertial or non-inertial cartesian
coordinate systems i.e.
\begin{multline}\label{vhfffngghkjgghfjjghghghhjghjgghkghgghjhggjjkgSYSPNjjjljjkljjkjkjjz1155kjjkjknnzz}
\hat R_{\xi'}\left(\vec x'_1,\ldots,\vec
x'_n,t'\right)\cdot\psi'\left(\vec x'_1,\ldots,\vec
x'_n,t'\right)=\phi'\left(\vec x'_1,\ldots,\vec x'_n,t'\right)
\quad\text{if and only if}\quad\\  \hat R_\xi\left(\vec
x_1,\ldots,\vec x_n,t\right)\cdot\psi\left(\vec x_1,\ldots,\vec
x_n,t\right)=\phi\left(\vec x_1,\ldots,\vec x_n,t\right),
\end{multline}
provided that, as before, we have
\er{yuythfgfyftydtydtydtyddyyyhhddhhhredPPN111hgghjgintintrrZZHHhkkhjhj1188}
and
\begin{multline}\label{vyfgjhgjhvhgghSYSPNjj11jhjggjjhjgzz}
\xi'\left(\vec x'_1,\ldots,\vec x'_n,\vec y'_1,\ldots,\vec
y'_n,t'\right)=\\
\left(\left(W(t)\otimes_{1}W(t)\ldots\otimes_{(n-1)}W(t)
\right)\otimes\left(\bar W(t)\otimes_{1}\bar
W(t)\ldots\otimes_{(n-1)}\bar W(t) \right)\right)\cdot\xi\left(\vec
x_1,\ldots,\vec x_n,\vec y_1,\ldots,\vec y_n,t\right).
\end{multline}
Moreover, in the case $\vec u\equiv 0$
\er{vhfffngghkjgghfjjghghghSYSPNjljllkljjhkhkhklpl;lpooprrccnnzz}
coincides with
\er{vhfffngghkjgghfjjghghghSYSPNjljllkljjhkhkhklpl;lpoopkklkljlrrccnnzz}.
However, we still need to derive the restrictions on the field $\vec
u$ and the Hamiltonian operator $\hat H_0$, providing that our
system can indeed be found in the state of thermodynamical
equilibrium. We remind that in the case $\vec u\equiv 0$ the
appropriate restriction is $\frac{\partial \hat H_0}{\partial
t}\equiv 0$. In order to get these restrictions in the general case,
we need to insert $\hat R_\xi$ in
\er{vhfffngghkjgghfjjghghghSYSPNjljllkljjhkhkhklpl;lpooprrccnnzz}
into the equation in
\er{vhfffngghkjgghfjjghghghhjghjgghkghggkghghjghSYSPNjjjhhhjjkjkjklkjljljjaa}
which is equivalent to the quantum Liouville equation.

Therefore, assume that the vector field $\vec u$ satisfies satisfies
\er{vhfffngghkjgghfjjghghghhjghjgghkghggSYSPNkljljjkklmjkjkjjk,mmm,kl;k;lklkjkjkkjkjkjkjljljjljklggghgkhhkhkhkkklkklklkljkhkhjjkhjh22}
with \er{vhfffngghkjgghfjjghghghSYSPNjljll'lhkhhjgggjgjgghgjkjjk22}
i.e
\begin{equation}\label{vhfffngghkjgghfjjghghghhjghjgghkghggSYSPNkljljjkklmjkjkjjk,mmm,kl;k;lklkjkjkkjkjkjkjljljjljklggghgkhhkhkhkkklkklklkljkhkhjjkhjhmnnmmn1188}
\begin{cases}
\frac{\partial U}{\partial t}\left(\vec x_1,\ldots,\vec
x_n,t\right)+\sum_{j=1}^{n}\vec u( \vec x_j,t)\cdot\nabla_{\vec
x_j}U\left(\vec x_1,\ldots,\vec x_n,t\right)=0,\\
\frac{\partial}{\partial t}\left(\vec u(\vec x,t)-\vec v(\vec
x,t)\right)-curl_{\vec x}\left(\vec u(\vec x,t)\times\left(\vec
u(\vec x,t)-\vec v(\vec x,t)\right)\right)+\left(\Div_{\vec
x}\left(\vec u(\vec x,t)-\vec v(\vec x,t)\right)\right)\vec
u(\vec x,t)=0,\\
\frac{\partial\vec A}{\partial t}(\vec x,t)-curl_{\vec x}\left(\vec
u(\vec x,t)\times\vec A(\vec x,t)\right)+\left(\Div_{\vec x}\vec
A(\vec x,t)\right)\vec u(\vec x,t)=0,
\\
d_{\vec x}\vec u(\vec x,t)+\left\{d_{\vec x}\vec u(\vec
x,t)\right\}^T=0,
\end{cases}
\end{equation}
where
\begin{multline}\label{vhfffngghkjgghfjjghghghSYSPNjljll'lhkhhjgggjgjgghgjkjjkjkjk1188}
U\left(\vec x_1,\ldots,\vec x_n,t\right):=\\
\sum_{j=1}^{n}\left(\frac{\sigma^2_j}{2m_jc^2}\left|\vec A(\vec
x_j,t)\right|^2+\sigma_j\Psi(\vec x_j,t)-\frac{\sigma_j}{c}\vec
A(\vec x_j,t)\cdot\vec v(\vec x_j,t)\right)-V\left(\vec
x_1,\ldots,\vec x_n,t\right).
\end{multline}
Then, by Proposition \ref{eqv3} below we have
\begin{equation}\label{vhfffngghkjgghfjjghghghhjghjgghkghggkghghjghSYSPNjjjhhhjjkjkjklkjljljkkhhkjiijjijhbmbgghghcchkhhnnzz}
i\hbar\frac{\partial}{\partial t}\left(f\circ \hat H_{\vec
u}\right)= \hat H_0\cdot\left(f\circ \hat H_{\vec
u}\right)-\left(f\circ \hat H_{\vec u}\right)\cdot\hat H_0,
\end{equation}
Then by
\er{vhfffngghkjgghfjjghghghhjghjgghkghggkghghjghSYSPNjjjhhhjjkjkjklkjljljkkhhkjiijjijhbmbgghghcchkhhnnzz}
together with
\er{gyfyfgfgfgghZZHHhuggkjhjjgg22hjhjjhjhmnbmbjkjhjhjhhaa} we deduce
that
\begin{equation}\label{vhfffngghkjgghfjjghghghhjghjgghkghggkghghjghSYSPNjjjhhhjjkjkjklkjljljkkhhkjiijjijhbmbgghghcchkhhjkjkknnzz}
i\hbar\frac{\partial}{\partial t}\left(\hat R_\xi\right)= \hat
H_0\cdot\left(\hat R_\xi\right)-\left(\hat R_\xi\right)\cdot\hat
H_0,
\end{equation}
where $\hat R_\xi$ is the operator in the left hand side of
\er{vhfffngghkjgghfjjghghghSYSPNjljllkljjhkhkhklpl;lpooprrccnnzz}.
So we indeed get
\er{vhfffngghkjgghfjjghghghhjghjgghkghggkghghjghSYSPNjjjhhhjjkjkjklkjljljjaa}
in the case of
\er{vhfffngghkjgghfjjghghghhjghjgghkghggSYSPNkljljjkklmjkjkjjk,mmm,kl;k;lklkjkjkkjkjkjkjljljjljklggghgkhhkhkhkkklkklklkljkhkhjjkhjhmnnmmn1188},
\er{vhfffngghkjgghfjjghghghSYSPNjljll'lhkhhjgggjgjgghgjkjjkjkjk1188}.
Moreover,
\er{vhfffngghkjgghfjjghghghhjghjgghkghggSYSPNkljljjkklmjkjkjjk,mmm,kl;k;lklkjkjkkjkjkjkjljljjljklggghgkhhkhkhkkklkklklkljkhkhjjkhjhmnnmmn1188},
\er{vhfffngghkjgghfjjghghghSYSPNjljll'lhkhhjgggjgjgghgjkjjkjkjk1188}
and
\er{btfffygtgyggyijhhkkSYSPNuhuygyygyggyuyyuyuykuhghgjjojiyyyuyuuyyijyyuyuhghghguydd}
are invariant under the change of inertial or non-inertial
coordinate system.

\begin{proposition}\label{eqv3}
Assume that the speed-like vector field $\vec u$ satisfies
\er{vhfffngghkjgghfjjghghghhjghjgghkghggSYSPNkljljjkklmjkjkjjk,mmm,kl;k;lklkjkjkkjkjkjkjljljjljklggghgkhhkhkhkkklkklklkljkhkhjjkhjhmnnmmn1188}
and
\er{vhfffngghkjgghfjjghghghSYSPNjljll'lhkhhjgggjgjgghgjkjjkjkjk1188}.
Next assume that the holomorphic function $f(z):\mathbb{C}\to
\mathbb{C}$ defined as a sum of the power series
\begin{equation}\label{vhfffngghkjgghfjjghghghhjghjgghkghggkghghjghSYSPNjjjhhhjjkjkjklkjljljkkhhkjiijjihhjjhjhjhjhkkhhk11cchjhjhcc}
f(z):=\sum_{m=0}^{+\infty}a_mz^m,
\end{equation}
with $a_m\in\mathbb{C}$, is such that for the operator $f\circ \hat
H_{\vec u}$, given by:
\begin{equation}\label{vhfffngghkjgghfjjghghghhjghjgghkghggkghghjghSYSPNjjjhhhjjkjkjklkjljljkkhhkjiijjihhjjhjhjhjhkkhhknnhkhh11cchjgjgjcc}
f\circ \hat H_{\vec u}:=\sum_{m=0}^{+\infty}a_m\hat H^m_{\vec u},
\end{equation}
where the operator $\hat H_{\vec u}$  is given by
\er{vhfffngghkjgghfjjghghghhjghjgghkghgghjhggjjkgSYSPNjjjljjkljjkjkjj88},
there exists a density matrix $\xi\left(\vec x_1,\ldots,\vec
x_n,\vec y_1,\ldots,\vec y_n,t\right)$, such that
\begin{equation}\label{vhfffngghkjgghfjjghghghhjghjgghkghggkghghjghSYSPNjjjhhhjjkjkjklkjljljkkhhkjiijjijhbmbgghgh11ggghhgfccjggjgjhccjhjgcc}
\left(f\circ\hat H_{\vec u}\right)\left(\vec x_1,\ldots,\vec
x_n,t\right)=\hat R_{\xi}\left(\vec x_1,\ldots,\vec x_n,t\right),
\end{equation}
where $\hat R_\xi$ is given by
\er{gyfyfgfgfgghZZHHhuggkjhjjggjhhhjhgg22jjjkjkkljjjkkjopiiiojjkljkjkjkljkjkjjaa}.
Then the operator $f\circ \hat H_{\vec u}$ satisfies:
\begin{equation}\label{vhfffngghkjgghfjjghghghhjghjgghkghggkghghjghSYSPNjjjhhhjjkjkjklkjljljkkhhkjiijjijhbmbgghghihihhccnnnmvcc}
i\hbar\frac{\partial}{\partial t}\left(f\circ \hat H_{\vec
u}\right)= \hat H_0\cdot\left(f\circ \hat H_{\vec
u}\right)-\left(f\circ \hat H_{\vec u}\right)\cdot\hat H_0,
\end{equation}
or equivalently
\begin{multline}\label{vhfffngghkjgghfjjghghghhjghjgghkghggkghghjghSYSPNjjjhhhjjccnnncc}
i\hbar\frac{\partial\xi}{\partial t}\left(\vec x_1,\ldots,\vec
x_n,\vec y_1,\ldots,\vec y_n,t\right)=\\ \left(\hat H_0\left(\vec
x_1,\ldots,\vec x_n,t\right)\otimes I\right)\cdot\xi\left(\vec
x_1,\ldots,\vec x_n,\vec y_1,\ldots,\vec
y_n,t\right)\\-\left(I\otimes\hat H^*_0\left(\vec y_1,\ldots,\vec
y_n,t\right)\right)\cdot\xi\left(\vec x_1,\ldots,\vec x_n,\vec
y_1,\ldots,\vec y_n,t\right).
\end{multline}
Moreover, $f\circ \hat H_{\vec u}$ is invariant under the change of
inertial or non-inertial cartesian coordinate systems i.e.
\begin{multline}\label{vhfffngghkjgghfjjghghghhjghjgghkghgghjhggjjkgSYSPNjjjljjkljjkjkjjzjjjkkhhklhhlcciyiyyyyccmmnvv}
\left(f\circ\hat H'_{\vec u'}\right)\left(\vec x'_1,\ldots,\vec
x'_n,t'\right)\cdot\psi'\left(\vec x'_1,\ldots,\vec
x'_n,t'\right)=\phi'\left(\vec x'_1,\ldots,\vec x'_n,t'\right)
\quad\text{if and only if}\quad\\  \left(f\circ\hat H_{\vec
u}\right)\left(\vec x_1,\ldots,\vec x_n,t\right)\cdot\psi\left(\vec
x_1,\ldots,\vec x_n,t\right)=\phi\left(\vec x_1,\ldots,\vec
x_n,t\right),
\end{multline}
provided that we have
\er{yuythfgfyftydtydtydtyddyyyhhddhhhredPPN111hgghjgintintrrZZHHhkkhjhj1188}.
Finally, if we assume that the first and the second particles have
the same mass $m_1=m_2$ and the same charge $\sigma_1=\sigma_2$ and
moreover, if we assume that
$$V\left(\vec x_1,\vec x_2,\vec x_3,\ldots,\vec
x_n,t\right)=V\left(\vec x_2,\vec x_1,\vec x_3,\ldots,\vec
x_n,t\right)\quad\quad\forall\,\vec x_1,\vec x_2,\vec
x_3,\ldots,\vec x_n\in\mathbb{R}^3,\quad\forall t,$$ then
\begin{multline}\label{fyfgjguyiuiHHgjgjgghbmgjgfhfhhHHljjjhjhjhyyuyuhhhljjjkjj88cc}
B_{1,2}\cdot\psi(\vec x_2,\vec x_1,\vec x_3,\ldots,\vec
x_n,t)=-\psi(\vec x_1,\vec x_2,\vec x_3,\ldots,\vec
x_n,t)\;\;\forall\,\vec x_1,\ldots,\vec
x_n\in\mathbb{R}^3\quad\text{and}\quad \phi=\left(f\circ\hat H_{\vec u}\right)\cdot\psi\\
\quad\text{implies}\quad B_{1,2}\cdot\phi(\vec x_2,\vec x_1,\vec
x_3,\ldots,\vec x_n,t)=-\phi(\vec x_1,\vec x_2,\vec x_3,\ldots,\vec
x_n,t)\;\;\forall\,\vec x_1,\ldots,\vec x_n\in\mathbb{R}^3,
\end{multline}
where, as before, by $B_{1,2}:\mathbb{C}^{4^n}\to \mathbb{C}^{4^n}$
we denote the linear operator (matrix) defined as in
\er{fyfgjguyiuiHHgjgjgghhHHkbhbbnew} by the following:
\begin{equation}\label{fyfgjguyiuiHHgjgjgghhHHhuhuhuhhkh88cc}
B_{1,2}\cdot\left(a_1\otimes a_2\otimes a_3\otimes\ldots\otimes
a_n\right)=\left(a_2\otimes a_1\otimes a_3\otimes\ldots\otimes
a_n\right)\quad\quad\forall\, a_1,\ldots a_n\in\mathbb{C}^{4}.
\end{equation}
\end{proposition}
\begin{proof}

Again by
\er{vhfffngghkjgghfjjghghghhjghjgghkghggSYSPNkljljjkklmjkjkjjk,mmm,kl;k;lklkjkjkkjkjkjkjljljjljklggghgkhhkhkhkkklkklklkljkhkhjjkhjhmnnmmn1188}
and Proposition \ref{jhjkhhjhjhgjgjjkjk} there exists another
cartesian coordinate system $(**)$ such that under the change of
coordinate system $(*)$ to another cartesian coordinate system
$(**)$, given by \er{noninchgravortbstrjgghguittu2hhhhh}, we have
\begin{equation}
\label{jljjjjkhhhjkkww1188}A(t)\cdot \vec u(\vec x,t)+
\frac{d A}{dt}(t)\cdot\vec x+
\frac{d\vec z}{dt}(t)=\vec u'(\vec x',t')=0.
\end{equation}
Thus, since
\er{vhfffngghkjgghfjjghghghhjghjgghkghggSYSPNkljljjkklmjkjkjjk,mmm,kl;k;lklkjkjkkjkjkjkjljljjljklggghgkhhkhkhkkklkklklkljkhkhjjkhjhmnnmmn1188},
\er{vhfffngghkjgghfjjghghghSYSPNjljll'lhkhhjgggjgjgghgjkjjkjkjk1188}
are invariant under the change of inertial or non-inertial cartesian
coordinate systems, as before, in system $(**)$ we have
\begin{equation}\label{vhfffngghkjgghfjjghghghhjghjgghkghggSYSPNkljljjkklmjkjkjjk,mmm,kl;k;lklkjkjkkjkjkjkjljljjljklggghgkhhkhkhkkklkklklkljkhkhjjkhjhmjmm,mjhjkjkhjhj1188}
\begin{cases}
\frac{\partial U'}{\partial t'}\left(\vec x'_1,\ldots,\vec
x'_n,t'\right)=0,\\
\frac{\partial\vec v'}{\partial t'}(\vec
x',t')=0,\\
\frac{\partial\vec A'}{\partial t'}(\vec x',t')=0,
\\
\vec u'(\vec x',t')=0,
\end{cases}
\end{equation}
where
\begin{multline}\label{vhfffngghkjgghfjjghghghSYSPNjljll'lhkhhjgggjgjgghgjkjjkkjjhj1188}
U'\left(\vec x'_1,\ldots,\vec x'_n,t'\right):=\\
\sum_{j=1}^{n}\left(\frac{(\sigma'_j)^2}{2m'_jc^2}\left|\vec A'(\vec
x'_j,t')\right|^2+\sigma'_j\Psi'(\vec
x'_j,t')-\frac{\sigma'_j}{c}\vec A'(\vec x'_j,t')\cdot\vec v'(\vec
x'_j,t')\right)-V'\left(\vec x'_1,\ldots,\vec x'_n,t'\right).
\end{multline}
On the other hand,
\er{vhfffngghkjgghfjjghghghhjghjgghkghggSYSPNkljljjkklmjkjkjjk,mmm,kl;k;lklkjkjkkjkjkjkjljljjljklggghgkhhkhkhkkklkklklkljkhkhjjkhjhmjmm,mjhjkjkhjhj1188}
is equivalent to
\begin{equation}\label{vhfffnhhjhjjkkjhkjkhkh1188}
\frac{\partial \hat H'_0}{\partial t'}\left(\vec x'_1,\ldots,\vec
x'_n,t'\right)=0\quad\text{and}\quad\vec u'(\vec x',t')=0,
\end{equation}
in system $(**)$. Thus, by \er{vhfffnhhjhjjkkjhkjkhkh1188}, in
system $(**)$ we deduce:
\begin{equation}\label{vhfffngghkjgghfjjghghghhjghjgghkghggkghghjghSYSPNjjjhhhjjkjkjklkjljljkkhhkjjjkljjbbb1188}
\hat H'_0\cdot\hat H'_{\vec u'}-\hat H'_{\vec u'}\cdot\hat H'_0=\hat
H'_0\cdot\hat H'_0-\hat H'_0\cdot\hat
H'_0=0=i\hbar\frac{\partial\hat H'_0}{\partial
t'}=i\hbar\frac{\partial\hat H'_{\vec u'}}{\partial t'}.
\end{equation}
So we get:
\begin{equation}\label{vhfffngghkjgghfjjghghghhjghjgghkghggkghghjghSYSPNjjjhhhjjkjkjklkjljljkkhhkjjjkljjbbbjkljljl1188}
i\hbar\frac{\partial\hat H'_{\vec u'}}{\partial t'}=\hat
H'_0\cdot\hat H'_{\vec u'}-\hat H'_{\vec u'}\cdot\hat H'_0.
\end{equation}
Therefore, given the holomorphic function $f(z):\mathbb{C}\to
\mathbb{C}$ defined as a sum of the power series
\begin{equation}\label{vhfffngghkjgghfjjghghghhjghjgghkghggkghghjghSYSPNjjjhhhjjkjkjklkjljljkkhhkjiijjihhjjhjhjhjhkkhhk1188}
f(z):=\sum_{m=0}^{+\infty}a_mz^m,
\end{equation}
with $a_m\in\mathbb{C}$, if we define the operator $f\circ \hat
H_{\vec u}$ as:
\begin{equation}\label{vhfffngghkjgghfjjghghghhjghjgghkghggkghghjghSYSPNjjjhhhjjkjkjklkjljljkkhhkjiijjihhjjhjhjhjhkkhhknnhkhh1188}
f\circ \hat H_{\vec u}:=\sum_{m=0}^{+\infty}a_m\hat H^m_{\vec u},
\end{equation}
by
\er{vhfffngghkjgghfjjghghghhjghjgghkghggkghghjghSYSPNjjjhhhjjkjkjklkjljljkkhhkjjjkljjbbbjkljljl1188}
and Proposition \ref{hugyfyfffgh} we deduce
\begin{equation}\label{vhfffngghkjgghfjjghghghhjghjgghkghggkghghjghSYSPNjjjhhhjjkjkjklkjljljkkhhkjiijjijhbmbgghghdd1188}
i\hbar\frac{\partial}{\partial t'}\left(f\circ \hat H'_{\vec
u'}\right)= \hat H'_0\cdot\left(f\circ \hat H'_{\vec
u'}\right)-\left(f\circ \hat H'_{\vec u'}\right)\cdot\hat H'_0.
\end{equation}
Moreover, by
\er{vhfffngghkjgghfjjghghghhjghjgghkghgghjhggjjkgSYSPNjjjljjkljjkjkjjz1188}
and
\er{vhfffngghkjgghfjjghghghhjghjgghkghggkghghjghSYSPNjjjhhhjjkjkjklkjljljkkhhkjiijjihhjjhjhjhjhkkhhknnhkhh1188},
we can easily prove that $f\circ \hat H_{\vec u}$ is invariant under
the change of inertial or non-inertial cartesian coordinate systems
i.e.
\begin{multline}\label{vhfffngghkjgghfjjghghghhjghjgghkghgghjhggjjkgSYSPNjjjljjkljjkjkjjzjjjkkhhklhhl1188}
\left(f\circ\hat H'_{\vec u'}\right)\left(\vec x'_1,\ldots,\vec
x'_n,t'\right)\cdot\psi'\left(\vec x'_1,\ldots,\vec
x'_n,t'\right)=\phi'\left(\vec x'_1,\ldots,\vec x'_n,t'\right)
\quad\text{if and only if}\quad\\  \left(f\circ\hat H_{\vec
u}\right)\left(\vec x_1,\ldots,\vec x_n,t\right)\cdot\psi\left(\vec
x_1,\ldots,\vec x_n,t\right)=\phi\left(\vec x_1,\ldots,\vec
x_n,t\right),
\end{multline}
provided that, as before, we have
\er{yuythfgfyftydtydtydtyddyyyhhddhhhredPPN111hgghjgintintrrZZHHhkkhjhj1188}.

Next assume that the holomorphic function $f$ in
\er{vhfffngghkjgghfjjghghghhjghjgghkghggkghghjghSYSPNjjjhhhjjkjkjklkjljljkkhhkjiijjihhjjhjhjhjhkkhhk1188}
is such that there exists a density matrix $\xi\left(\vec
x_1,\ldots,\vec x_n,\vec y_1,\ldots,\vec y_n,t\right)$ satisfying
\begin{equation}\label{vhfffngghkjgghfjjghghghhjghjgghkghggkghghjghSYSPNjjjhhhjjkjkjklkjljljkkhhkjiijjijhbmbgghgh11ggghhgf11gg1188}
\left(f\circ\hat H_{\vec u}\right)\left(\vec x_1,\ldots,\vec
x_n,t\right)=\hat R_{\xi}\left(\vec x_1,\ldots,\vec x_n,t\right),
\end{equation}
where $\hat R_\xi$ is given by
\er{gyfyfgfgfgghZZHHhuggkjhjjggjhhhjhgg22jjjkjkkljjjkkjopiiiojjkljkjkjkljkjkjjaa}.
Then, by
\er{vhfffngghkjgghfjjghghghhjghjgghkghgghjhggjjkgSYSPNjjjljjkljjkjkjjzjjjkkhhklhhl1188}
for the density matrix $\xi'\left(\vec x'_1,\ldots,\vec x'_n,\vec
y'_1,\ldots,\vec y'_n,t'\right)$ we have
\begin{equation}\label{vhfffngghkjgghfjjghghghhjghjgghkghggkghghjghSYSPNjjjhhhjjkjkjklkjljljkkhhkjiijjijhbmbgghgh11ggghhgf1188}
\left(f\circ\hat H'_{\vec u'}\right)\left(\vec x'_1,\ldots,\vec
x'_n,t'\right)=\hat R_{\xi'}\left(\vec x'_1,\ldots,\vec
x'_n,t'\right),
\end{equation}
provided that
\begin{multline}\label{vyfgjhgjhvhgghSYSPNjj11jhjggj88}
\xi'\left(\vec x'_1,\ldots,\vec x'_n,\vec y'_1,\ldots,\vec
y'_n,t'\right)=\\
\left(\left(W(t)\otimes_{1}W(t)\ldots\otimes_{(n-1)}W(t)
\right)\otimes\left(\bar W(t)\otimes_{1}\bar
W(t)\ldots\otimes_{(n-1)}\bar W(t) \right)\right)\cdot\xi\left(\vec
x_1,\ldots,\vec x_n,\vec y_1,\ldots,\vec y_n,t\right).
\end{multline}
Therefore, since we obtained before, that equation
\er{vhfffngghkjgghfjjghghghhjghjgghkghggkghghjghSYSPNjjjhhhjjkjkjklkjljljkkhhkjiijjijhbmbgghghdd1188}
is equivalent to the primed version of
\er{vhfffngghkjgghfjjghghghhjghjgghkghggkghghjghSYSPNjjjhhhjjaa} and
at the same time equation
\er{vhfffngghkjgghfjjghghghhjghjgghkghggkghghjghSYSPNjjjhhhjjaa} is
invariant under the change of non-inertial cartesian coordinate
system, provided we have \er{vyfgjhgjhvhgghSYSPNjj11jhjggj88}, with
the help of
\er{vhfffngghkjgghfjjghghghhjghjgghkghggkghghjghSYSPNjjjhhhjjkjkjklkjljljkkhhkjiijjijhbmbgghgh11ggghhgf11gg1188},
as before, we deduce
\begin{equation}\label{vhfffngghkjgghfjjghghghhjghjgghkghggkghghjghSYSPNjjjhhhjjkjkjklkjljljkkhhkjiijjijhbmbgghgh1188}
i\hbar\frac{\partial}{\partial t}\left(f\circ \hat H_{\vec
u}\right)= \hat H_0\cdot\left(f\circ \hat H_{\vec
u}\right)-\left(f\circ \hat H_{\vec u}\right)\cdot\hat H_0
\end{equation}
in an arbitrary coordinate system. So we get
\er{vhfffngghkjgghfjjghghghhjghjgghkghggkghghjghSYSPNjjjhhhjjkjkjklkjljljkkhhkjiijjijhbmbgghghihihhccnnnmvcc}
or equivalently
\er{vhfffngghkjgghfjjghghghhjghjgghkghggkghghjghSYSPNjjjhhhjjccnnncc}.

Finally, assume that the first and the second particles have the
same mass $m_1=m_2$ and the same charge $\sigma_1=\sigma_2$ and
moreover, assume that we have
$$V\left(\vec x_1,\vec x_2,\vec x_3,\ldots,\vec
x_n,t\right)=V\left(\vec x_2,\vec x_1,\vec x_3,\ldots,\vec
x_n,t\right)\quad\quad\forall\,\vec x_1,\vec x_2,\vec
x_3,\ldots,\vec x_n\in\mathbb{R}^3,\quad\forall t.$$ In this case
clearly by
\er{vhfffngghkjgghfjjghghghhjghjgghkghgghjhggjjkgSYSPNjjjljjkljjkjkjj88}
we deduce the following relation:
\begin{multline}\label{fyfgjguyiuiHHgjgjgghbmgjgfhfhhHHljjjhjhjhyyuyuhhh88}
B_{1,2}\cdot\psi(\vec x_2,\vec x_1,\vec x_3,\ldots,\vec
x_n,t)=-\psi(\vec x_1,\vec x_2,\vec x_3,\ldots,\vec
x_n,t)\;\;\forall\,\vec x_1,\ldots,\vec
x_n\in\mathbb{R}^3\quad\text{and}\quad \phi=\hat H_{\vec u}\cdot\psi\\
\quad\text{implies}\quad B_{1,2}\cdot\phi(\vec x_2,\vec x_1,\vec
x_3,\ldots,\vec x_n,t)=-\phi(\vec x_1,\vec x_2,\vec x_3,\ldots,\vec
x_n,t)\;\;\forall\,\vec x_1,\ldots,\vec x_n\in\mathbb{R}^3,
\end{multline}
where, as before, by $B_{1,2}:\mathbb{C}^{4^n}\to \mathbb{C}^{4^n}$
we denote the linear operator (matrix) defined as in
\er{fyfgjguyiuiHHgjgjgghhHHkbhbbnew} by the following:
\begin{equation}\label{fyfgjguyiuiHHgjgjgghhHHhuhuhuhhkh88}
B_{1,2}\cdot\left(a_1\otimes a_2\otimes a_3\otimes\ldots\otimes
a_n\right)=\left(a_2\otimes a_1\otimes a_3\otimes\ldots\otimes
a_n\right)\quad\quad\forall\, a_1,\ldots a_n\in\mathbb{C}^{4}.
\end{equation}
Therefore, by
\er{vhfffngghkjgghfjjghghghhjghjgghkghggkghghjghSYSPNjjjhhhjjkjkjklkjljljkkhhkjiijjihhjjhjhjhjhkkhhknnhkhh1188}
and \er{fyfgjguyiuiHHgjgjgghbmgjgfhfhhHHljjjhjhjhyyuyuhhh88} we
obtain
\begin{multline}\label{fyfgjguyiuiHHgjgjgghbmgjgfhfhhHHljjjhjhjhyyuyuhhhljjjkjj88}
B_{1,2}\cdot\psi(\vec x_2,\vec x_1,\vec x_3,\ldots,\vec
x_n,t)=-\psi(\vec x_1,\vec x_2,\vec x_3,\ldots,\vec
x_n,t)\;\;\forall\,\vec x_1,\ldots,\vec
x_n\in\mathbb{R}^3\quad\text{and}\quad \phi=\left(f\circ\hat H_{\vec u}\right)\cdot\psi\\
\quad\text{implies}\quad B_{1,2}\cdot\phi(\vec x_2,\vec x_1,\vec
x_3,\ldots,\vec x_n,t)=-\phi(\vec x_1,\vec x_2,\vec x_3,\ldots,\vec
x_n,t)\;\;\forall\,\vec x_1,\ldots,\vec x_n\in\mathbb{R}^3,
\end{multline}
consistently with
\er{fyfgjguyiuiHHgjgjgghbmgjgfhfhhHHljjjhjhjhyyuyuaa}.
\end{proof}

\subsection{Spinless relativistic-like particles: the Klein--Gordon
equation}\label{hggyugyuy1xx} Consider the motion of a system of $n$
stable relativistic-like spinless quantum micro-particles with
inertial masses $m_1,\ldots,m_n$ and the charges
$\sigma_1,\ldots,\sigma_n$ in the outer gravitational and
electromagnetical field with characteristics $\vec v(\vec x,t)$,
$\vec A(\vec x,t)$ and $\Psi(\vec x,t)$ and additional conservative
field with potential $V(\vec y_1,\ldots,\vec y_n,t)$. The
Klein--Gordon equation for this system of particles is is the
following:
\begin{multline}\label{vhfffngghkjgghfjjghghghhjghjgghkghgghjhggjjkgfgdiyfgfjkjgjgSYSPNxx}
\frac{\hbar^2}{2c^2\left(\sum\limits_{k=1}^{n}m_k\right)}\frac{\partial}{\partial
t}\left(\frac{\partial\psi}{\partial t}+\sum\limits_{k=1}^{n}\vec
v(\vec x_k,t)\cdot\nabla_{\vec
x_k}\psi+\sum\limits_{k=1}^{n}\frac{i\sigma_k}{\hbar}\Psi_0(\vec
x_k,t)\psi\right)\\+\sum_{j=1}^{n}\frac{\hbar^2}{2c^2\left(\sum\limits_{k=1}^{n}m_k\right)}\Div_{\vec
x_j}\left\{\vec v(\vec x_j,t)\left(\frac{\partial\psi}{\partial
t}+\sum\limits_{k=1}^{n}\vec v(\vec x_k,t)\cdot\nabla_{\vec
x_k}\psi+\sum\limits_{k=1}^{n}\frac{i\sigma_k}{\hbar}\Psi_0(\vec
x_k,t)\psi\right)\right\}
\\
+\frac{\hbar^2}{2c^2\left(\sum\limits_{k=1}^{n}m_k\right)}\left(\sum\limits_{k=1}^{n}\frac{i\sigma_k}{\hbar}\Psi_0(\vec
x_k,t)\right)\left(\frac{\partial\psi}{\partial
t}+\sum\limits_{k=1}^{n}\vec v(\vec x_k,t)\cdot\nabla_{\vec
x_k}\psi+\sum\limits_{k=1}^{n}\frac{i\sigma_k}{\hbar}\Psi_0(\vec
x_k,t)\psi\right)
\\
+\frac{c^2\left(\sum\limits_{k=1}^{n}m_k\right)}{2}\psi-\sum_{j=1}^{n}\frac{\hbar^2}{2m_j}\Delta_{\vec
x_j}\psi-V\left(\vec x_1,\ldots,\vec
x_n,t\right)\psi\\+\sum_{j=1}^{n}\frac{
i\hbar\sigma_j}{2m_jc}div_{\vec x_j}\left\{\psi\vec A(\vec
x_j,t)\right\}+\sum_{j=1}^{n}\frac{ i\hbar\sigma_j}{2m_jc}\vec
A(\vec x_j,t)\cdot\nabla_{\vec x_j}\psi+\sum_{j=1}^{n}\left(
\frac{\sigma^2_j}{2m_jc^2}\left|\vec A(\vec
x_j,t)\right|^2\right)\psi=0,
\end{multline}
where, as before,
\begin{equation}\label{noninchgravortbstrghgggSYSPNxxjjkkjxx}
\Psi_0(\vec x,t):=\Psi(\vec x,t)-\frac{1}{c}\vec v(\vec
x,t)\cdot\vec A(\vec x,t)
\end{equation}
is the proper electrical potential and $\psi=\psi\left(\vec
x_1,\ldots,\vec x_n,t\right)\in\mathbb{C}$ is the scalar wave
function. Next consider a change of some non-inertial cartesian
coordinate system $(*)$ to another cartesian coordinate system
$(**)$ of the form:
\begin{equation}\label{noninchgravortbstrghgggSYSPNxx}
\begin{cases}
\vec x'=A(t)\cdot\vec x+\vec z(t),\\
t'=t,
\end{cases}
\end{equation}
where $A(t)\in SO(3)$ is a rotation.
Then, we deduce that  the  Klein--Gordon  equation of the form
\er{vhfffngghkjgghfjjghghghhjghjgghkghgghjhggjjkgfgdiyfgfjkjgjgSYSPNxx}
is invariant under the change of non-inertial cartesian coordinate
system, provided that under \er{noninchgravortbstrghgggSYSPNxx} we
have
\begin{equation}\label{vyfgjhgjhvhgghSYSPNxx}
\begin{cases}
\psi'=\psi
\\
V'=V
\\
\vec A'=A(t)\cdot\vec A
\\
\Psi'_0:=\Psi'-\frac{1}{c}\vec A'\cdot\vec v'=\Psi-\frac{1}{c}\vec
A\cdot\vec v:=\Psi_0.
\end{cases}
\end{equation}

Next defining
\begin{equation}\label{noninchgravortbstrghgggSYSPNxxjjjkjkxx}
\psi_1\left(\vec x_1,\ldots,\vec
x_n,t\right):=e^{\frac{ic^2\left(\sum\limits_{k=1}^{n}m_k\right)t}{\hbar}}\psi\left(\vec
x_1,\ldots,\vec x_n,t\right)
\end{equation}
we rewrite
\er{vhfffngghkjgghfjjghghghhjghjgghkghgghjhggjjkgfgdiyfgfjkjgjgSYSPNxx}
as:
\begin{multline}\label{vhfffngghkjgghfjjghghghhjghjgghkghgghjhggjjkgfgdiyfgfjkjgjgSYSPNxxjkhhkxx}
\frac{c^2\left(\sum\limits_{k=1}^{n}m_k\right)}{2}\psi_1+\frac{\hbar^2}{2c^2\left(\sum\limits_{k=1}^{n}m_k\right)}e^{\frac{ic^2\left(\sum\limits_{k=1}^{n}m_k\right)t}{\hbar}}\frac{\partial}{\partial
t}\left(\left(\frac{\partial\psi_1}{\partial
t}+\sum\limits_{k=1}^{n}\vec v(\vec x_k,t)\cdot\nabla_{\vec
x_k}\psi_1\right)e^{-\frac{ic^2\left(\sum\limits_{k=1}^{n}m_k\right)t}{\hbar}}\right)
\\+
\frac{\hbar^2}{2c^2\left(\sum\limits_{k=1}^{n}m_k\right)}e^{\frac{ic^2\left(\sum\limits_{k=1}^{n}m_k\right)t}{\hbar}}\frac{\partial}{\partial
t}\left(\left(\sum\limits_{k=1}^{n}\frac{i\sigma_k}{\hbar}\Psi_0(\vec
x_k,t)\psi_1-\frac{ic^2\left(\sum\limits_{k=1}^{n}m_k\right)}{\hbar}\psi_1\right)e^{-\frac{ic^2\left(\sum\limits_{k=1}^{n}m_k\right)t}{\hbar}}\right)\\+\sum_{j=1}^{n}\frac{\hbar^2}{2c^2\left(\sum\limits_{k=1}^{n}m_k\right)}\Div_{\vec
x_j}\left\{\vec v(\vec x_j,t)\left(\frac{\partial\psi_1}{\partial t}
+\sum\limits_{k=1}^{n}\vec v(\vec x_k,t)\cdot\nabla_{\vec
x_k}\psi_1+\sum\limits_{k=1}^{n}\frac{i\sigma_k}{\hbar}\Psi_0(\vec
x_k,t)\psi_1\right)\right\}
\\
+\frac{\hbar^2}{2c^2\left(\sum\limits_{k=1}^{n}m_k\right)}\left(\sum\limits_{k=1}^{n}\frac{i\sigma_k}{\hbar}\Psi_0(\vec
x_k,t)\right)\left(\frac{\partial\psi_1}{\partial
t}+\sum\limits_{k=1}^{n}\vec v(\vec x_k,t)\cdot\nabla_{\vec
x_k}\psi_1+\sum\limits_{k=1}^{n}\frac{i\sigma_k}{\hbar}\Psi_0(\vec
x_k,t)\psi_1\right)
\\
-\sum_{j=1}^{n}\frac{\hbar^2}{2c^2\left(\sum\limits_{k=1}^{n}m_k\right)}\Div_{\vec
x_j}\left\{\vec v(\vec
x_j,t)\left(\frac{ic^2\left(\sum\limits_{k=1}^{n}m_k\right)}{\hbar}\psi_1\right)\right\}\\-\frac{\hbar^2}{2c^2\left(\sum\limits_{k=1}^{n}m_k\right)}\left(\sum\limits_{k=1}^{n}\frac{i\sigma_k}{\hbar}\Psi_0(\vec
x_k,t)\right)\left(\frac{ic^2\left(\sum\limits_{k=1}^{n}m_k\right)}{\hbar}\psi_1\right)
-\sum_{j=1}^{n}\frac{\hbar^2}{2m_j}\Delta_{\vec
x_j}\psi_1-V\left(\vec x_1,\ldots,\vec
x_n,t\right)\psi_1\\+\sum_{j=1}^{n}\frac{
i\hbar\sigma_j}{2m_jc}div_{\vec x_j}\left\{\psi_1\vec A(\vec
x_j,t)\right\}+\sum_{j=1}^{n}\frac{ i\hbar\sigma_j}{2m_jc}\vec
A(\vec x_j,t)\cdot\nabla_{\vec x_j}\psi_1+\sum_{j=1}^{n}\left(
\frac{\sigma^2_j}{2m_jc^2}\left|\vec A(\vec
x_j,t)\right|^2\right)\psi_1=0,
\end{multline}
that we further rewrite as:
\begin{multline}\label{vhfffngghkjgghfjjghghghhjghjgghkghgghjhggjjkgfgdiyfgfjkjgjgSYSPNxxjkhhkxxhjjhjhxx}
\frac{\hbar^2}{2c^2\left(\sum\limits_{k=1}^{n}m_k\right)}\frac{\partial}{\partial
t}\left(\frac{\partial\psi_1}{\partial t}+\sum\limits_{k=1}^{n}\vec
v(\vec x_k,t)\cdot\nabla_{\vec
x_k}\psi_1+\sum\limits_{k=1}^{n}\frac{i\sigma_k}{\hbar}\Psi_0(\vec
x_k,t)\psi_1\right)\\+\sum_{j=1}^{n}\frac{\hbar^2}{2c^2\left(\sum\limits_{k=1}^{n}m_k\right)}\Div_{\vec
x_j}\left\{\vec v(\vec x_j,t)\left(\frac{\partial\psi_1}{\partial
t}+\sum\limits_{k=1}^{n}\vec v(\vec x_k,t)\cdot\nabla_{\vec
x_k}\psi_1+\sum\limits_{k=1}^{n}\frac{i\sigma_k}{\hbar}\Psi_0(\vec
x_k,t)\psi_1\right)\right\}
\\
+\frac{\hbar^2}{2c^2\left(\sum\limits_{k=1}^{n}m_k\right)}\left(\sum\limits_{k=1}^{n}\frac{i\sigma_k}{\hbar}\Psi_0(\vec
x_k,t)\right)\left(\frac{\partial\psi_1}{\partial
t}+\sum\limits_{k=1}^{n}\vec v(\vec x_k,t)\cdot\nabla_{\vec
x_k}\psi_1+\sum\limits_{k=1}^{n}\frac{i\sigma_k}{\hbar}\Psi_0(\vec
x_k,t)\psi_1\right)
\\
-i\hbar\left(\frac{\partial\psi_1}{\partial
t}+\sum\limits_{k=1}^{n}\frac{1}{2}\vec v(\vec
x_k,t)\cdot\nabla_{\vec
x_k}\psi_1+\sum\limits_{k=1}^{n}\frac{1}{2}\Div_{\vec
x_k}\left\{\psi_1\vec v(\vec
x_k,t)\right\}+\sum\limits_{k=1}^{n}\frac{i\sigma_k}{\hbar}\Psi_0(\vec
x_k,t)\psi_1\right)\\-\sum_{j=1}^{n}\frac{\hbar^2}{2m_j}\Delta_{\vec
x_j}\psi_1-V\left(\vec x_1,\ldots,\vec
x_n,t\right)\psi_1\\+\sum_{j=1}^{n}\frac{
i\hbar\sigma_j}{2m_jc}div_{\vec x_j}\left\{\psi_1\vec A(\vec
x_j,t)\right\}+\sum_{j=1}^{n}\frac{ i\hbar\sigma_j}{2m_jc}\vec
A(\vec x_j,t)\cdot\nabla_{\vec x_j}\psi_1+\sum_{j=1}^{n}\left(
\frac{\sigma^2_j}{2m_jc^2}\left|\vec A(\vec
x_j,t)\right|^2\right)\psi_1=0.
\end{multline}
Thus by
\er{vhfffngghkjgghfjjghghghhjghjgghkghgghjhggjjkgfgdiyfgfjkjgjgSYSPNxxjkhhkxxhjjhjhxx}
in the non-relativistic limit we obtain:
\begin{multline}\label{vhfffngghkjgghfjjghghghhjghjgghkghgghjhggjjkgfgdiyfgfjkjgjgSYSPNjjj}
i\hbar\left(\frac{\partial\psi_1}{\partial t}+\sum_{j=1}^{n}\vec
v(\vec x_j,t)\cdot\nabla_{\vec
x_j}\psi_1\right)+\sum_{j=1}^{n}\frac{i\hbar}{2}\left(div_{\vec
x_j}\vec v(\vec x_j,t)\right)\psi_1\approx\\
-\sum_{j=1}^{n}\frac{\hbar^2}{2m_j}\Delta_{\vec
x_j}\psi_1-V\left(\vec x_1,\ldots,\vec
x_n,t\right)\psi_1+\sum_{j=1}^{n}\frac{
i\hbar\sigma_j}{2m_jc}div_{\vec x_j}\left\{\psi_1\vec A(\vec
x_j,t)\right\}+\sum_{j=1}^{n}\frac{ i\hbar\sigma_j}{2m_jc}\vec
A(\vec x_j,t)\cdot\nabla_{\vec
x_j}\psi_1\\+\sum_{j=1}^{n}\left(\sigma_j\Psi(\vec
x_j,t)-\frac{\sigma_j}{c}\vec A(\vec x_j,t)\cdot\vec v(\vec
x_j,t)+\frac{\sigma^2_j}{2m_jc^2}\left|\vec A(\vec
x_j,t)\right|^2\right)\psi_1,
\end{multline}
that coincides with the Schr\"{o}dinger equation of the form
\er{vhfffngghkjgghfjjghghghhjghjgghkghgghjhggjjkgfgdiyfgfjkjgjgSYSPNHH}.

Next, again consider the motion and interaction of system of $n$
stable quantum micro-particles having inertial masses $m_1,\ldots,
m_n$ and the charges $\sigma_1,\ldots,\sigma_n$ with the known
gravitational and electromagnetical field with potentials $\vec
v(\vec x,t)$, $\vec A(\vec x,t)$ and $\Psi(\vec x,t)$ and additional
conservative field with potential $V(\vec y_1,\ldots,\vec y_n,t)$.
Then consider a Lagrangian density $L$ defined by
\begin{multline}\label{vhfffngghkjgghDDmmkkksmfggffgZZxx}
L_2\left(\psi,\vec A,\Psi,\vec v,\vec x_1,\ldots,\vec
x_n,t\right):=-\frac{c^2\left(\sum\limits_{k=1}^{n}m_k\right)}{2}\psi\cdot\bar\psi-\sum\limits_{k=1}^{n}\frac{\hbar^2}{2m_k}\nabla_{\vec
x_k}\psi\cdot\nabla_{\vec
x_k}\bar\psi\\
-\sum\limits_{k=1}^{n}\frac{\hbar\sigma_k
i}{2m_kc}\left(\nabla_{\vec
x_k}\psi\cdot\bar\psi-\psi\cdot\nabla_{\vec
x_k}\bar\psi\right)\cdot\vec A(\vec
x_k,t)-\sum\limits_{k=1}^{n}\frac{\sigma^2_k}{2m_kc^2}\left|\vec
A(\vec x_k,t)\right|^2\psi\cdot\bar\psi+V\left(\vec x_1,\ldots,\vec
x_n,t\right)\psi\cdot\bar\psi
\\
+\frac{\hbar^2}{2c^2\left(\sum\limits_{k=1}^{n}m_k\right)}\left(\frac{\partial\psi}{\partial
t}+\sum\limits_{k=1}^{n}\vec v(\vec x_k,t)\cdot\nabla_{\vec
x_k}\psi+\sum\limits_{k=1}^{n}\frac{i\sigma_k}{\hbar}\Psi_0(\vec
x_k,t)\psi\right)\times\\
\times\left(\frac{\partial\bar\psi}{\partial
t}+\sum\limits_{k=1}^{n}\vec v(\vec x_k,t)\cdot\nabla_{\vec
x_k}\bar\psi-\sum\limits_{k=1}^{n}\frac{i\sigma_k}{\hbar}\Psi_0(\vec
x_k,t)\bar\psi\right),
\end{multline}
where $\psi(\vec x_1,\ldots,\vec x_n,t)\in\mathbb{C}$ is a wave
function of the system. Then, as before, it can be proven that $L$
is invariant under the change of inertial or non-inertial cartesian
coordinate systems of the form
\begin{equation*}
\begin{cases}
t'=t\\
\vec x'_k=A(t)\cdot\vec x_k+\vec z(t)\quad\quad\forall
k=1,\ldots,n,\end{cases}
\end{equation*}
provided that $\psi'=\psi$. We investigate stationary points of the
functional
\begin{equation}\label{btfffygtgyggyDDmmkkksmZZxx}
J=\int_0^T\int_{\left(\mathbb{R}^3\right)^n}L\left(\psi,\vec
A,\Psi,\vec v,\vec x_1,\ldots,\vec x_n,t\right)d\vec x_1\ldots,d\vec
x_n dt.
\end{equation}
Then,
\begin{multline}\label{vhfffngghkjgghDDmmkkkghgsmZZxx}
0=\frac{\delta L_2}{\delta (\bar\psi)}=
-\frac{\hbar^2}{2c^2\left(\sum\limits_{k=1}^{n}m_k\right)}\frac{\partial}{\partial
t}\left(\frac{\partial\psi}{\partial t}+\sum\limits_{k=1}^{n}\vec
v(\vec x_k,t)\cdot\nabla_{\vec
x_k}\psi+\sum\limits_{k=1}^{n}\frac{i\sigma_k}{\hbar}\Psi_0(\vec
x_k,t)\psi\right)\\-\sum_{j=1}^{n}\frac{\hbar^2}{2c^2\left(\sum\limits_{k=1}^{n}m_k\right)}\Div_{\vec
x_j}\left\{\vec v(\vec x_j,t)\left(\frac{\partial\psi}{\partial
t}+\sum\limits_{k=1}^{n}\vec v(\vec x_k,t)\cdot\nabla_{\vec
x_k}\psi+\sum\limits_{k=1}^{n}\frac{i\sigma_k}{\hbar}\Psi_0(\vec
x_k,t)\psi\right)\right\}
\\
-\frac{\hbar^2}{2c^2\left(\sum\limits_{k=1}^{n}m_k\right)}\left(\sum\limits_{k=1}^{n}\frac{i\sigma_k}{\hbar}\Psi_0(\vec
x_k,t)\right)\left(\frac{\partial\psi}{\partial
t}+\sum\limits_{k=1}^{n}\vec v(\vec x_k,t)\cdot\nabla_{\vec
x_k}\psi+\sum\limits_{k=1}^{n}\frac{i\sigma_k}{\hbar}\Psi_0(\vec
x_k,t)\psi\right)
\\
-\frac{c^2\left(\sum\limits_{k=1}^{n}m_k\right)}{2}\psi+\sum\limits_{k=1}^{n}\frac{\hbar^2}{2m_k}\Delta_{\vec
x_k}\psi -\sum\limits_{k=1}^{n}\frac{\hbar\sigma_k
i}{2m_kc}\left(\vec A(\vec x_k,t)\cdot\nabla_{\vec
x_k}\psi+div_{\vec x_k}\left\{\psi\vec A(\vec
x_k,t)\right\}\right)\\
-\sum\limits_{k=1}^{n}\frac{\sigma^2_k}{2m_kc^2}\left|\vec A(\vec
x_k,t)\right|^2\psi
+V\left(\vec x_1,\ldots,\vec x_n,t\right)\psi,
\end{multline}
and
\begin{multline}\label{vhfffngghkjgghDDmmkkkghguhyuysmZZxx}
0=\frac{\delta L_2}{\delta (\psi)}=
-\frac{\hbar^2}{2c^2\left(\sum\limits_{k=1}^{n}m_k\right)}\frac{\partial}{\partial
t}\left(\frac{\partial\bar\psi}{\partial
t}+\sum\limits_{k=1}^{n}\vec v(\vec x_k,t)\cdot\nabla_{\vec
x_k}\bar\psi+\sum\limits_{k=1}^{n}\frac{\bar
{(i)}\sigma_k}{\hbar}\Psi_0(\vec
x_k,t)\bar\psi\right)\\-\sum_{j=1}^{n}\frac{\hbar^2}{2c^2\left(\sum\limits_{k=1}^{n}m_k\right)}\Div_{\vec
x_j}\left\{\vec v(\vec x_j,t)\left(\frac{\partial\bar\psi}{\partial
t}+\sum\limits_{k=1}^{n}\vec v(\vec x_k,t)\cdot\nabla_{\vec
x_k}\bar\psi+\sum\limits_{k=1}^{n}\frac{\bar
{(i)}\sigma_k}{\hbar}\Psi_0(\vec x_k,t)\bar\psi\right)\right\}
\\
-\frac{\hbar^2}{2c^2\left(\sum\limits_{k=1}^{n}m_k\right)}\left(\sum\limits_{k=1}^{n}\frac{\bar
{(i)}\sigma_k}{\hbar}\Psi_0(\vec
x_k,t)\right)\left(\frac{\partial\bar\psi}{\partial
t}+\sum\limits_{k=1}^{n}\vec v(\vec x_k,t)\cdot\nabla_{\vec
x_k}\bar\psi+\sum\limits_{k=1}^{n}\frac{\bar
{(i)}\sigma_k}{\hbar}\Psi_0(\vec x_k,t)\bar\psi\right)\\
-\frac{c^2\left(\sum\limits_{k=1}^{n}m_k\right)}{2}\bar\psi+\sum\limits_{k=1}^{n}\frac{\hbar^2}{2m_k}\Delta_{\vec
x_k}\bar\psi -\sum\limits_{k=1}^{n}\frac{\hbar\sigma_k \bar
{(i)}}{2m_kc}\left(\vec A(\vec x_k,t)\cdot\nabla_{\vec
x_k}\bar\psi+div_{\vec x_k}\left\{\bar\psi\vec A(\vec
x_k,t)\right\}\right)\\-\sum\limits_{k=1}^{n}\frac{\sigma^2_k}{2m_kc^2}\left|\vec
A(\vec x_k,t)\right|^2\bar\psi
+V\left(\vec x_1,\ldots,\vec x_n,t\right)\bar\psi,
\end{multline}
where the last equality is just the complex conjugate of
\er{vhfffngghkjgghDDmmkkkghgsmZZxx}. So we get that the
Euler-Lagrange equation for \er{btfffygtgyggyDDmmkkksmZZxx}
coincides with the Klein--Gordon  equation of the form
\er{vhfffngghkjgghfjjghghghhjghjgghkghgghjhggjjkgfgdiyfgfjkjgjgSYSPNxx}.

Finally, assume that the first and the second particles have the
same mass $m_1=m_2$ and the same charge $\sigma_1=\sigma_2$ and
moreover, assume that we have
$$V\left(\vec x_1,\vec x_2,\vec x_3,\ldots,\vec
x_n,t\right)=V\left(\vec x_2,\vec x_1,\vec x_3,\ldots,\vec
x_n,t\right)\quad\quad\forall\,\vec x_1,\vec x_2,\vec
x_3,\ldots,\vec x_n\in\mathbb{R}^3,\quad\forall t.$$ In this case it
can be easily deduced that if  $\psi\left(\vec x_1,\vec x_2,\vec
x_3,\ldots,\vec x_n,t\right)$ is a solution of
\er{vhfffngghkjgghfjjghghghhjghjgghkghgghjhggjjkgfgdiyfgfjkjgjgSYSPNxx},
then $\psi(\vec x_2,\vec x_1,\vec x_3,\ldots,\vec x_n,t)$ as also a
solutions of
\er{vhfffngghkjgghfjjghghghhjghjgghkghgghjhggjjkgfgdiyfgfjkjgjgSYSPNxx},
Therefore, if $\psi\left(\vec x_1,\vec x_2,\vec x_3,\ldots,\vec
x_n,t\right)$ is a solution of
\er{vhfffngghkjgghfjjghghghhjghjgghkghgghjhggjjkgfgdiyfgfjkjgjgSYSPNxx},
then for every $t\geq 0$ we will have
\begin{multline}\label{fyfgjguyiuiHHgjgjgghbmgjgfhfhhHHrr22ggkk}
\psi(\vec x_2,\vec x_1,\vec x_3,\ldots,\vec x_n,t)=\psi\left(\vec
x_1,\vec x_2,\vec x_3,\ldots,\vec x_n,t\right)\quad\text{and}\quad\\
\frac{\partial\psi}{\partial t}(\vec x_2,\vec x_1,\vec
x_3,\ldots,\vec x_n,t)=\frac{\partial\psi}{\partial t}\left(\vec
x_1,\vec x_2,\vec x_3,\ldots,\vec x_n,t\right)\quad\forall\,\vec
x_1,\vec x_2,\vec x_3,\ldots,\vec x_n\in\mathbb{R}^3,
\end{multline}
provided that \er{fyfgjguyiuiHHgjgjgghbmgjgfhfhhHHrr22ggkk} holds
for the initial instant of time $t=0$. So we have a consistency with
the principles of identity for two or more identical bosons, in the
case of spinless particles.

\section{Thermodynamics of a moving continuum medium}\label{gjgggggggggggggggggggggggghkgu}
Again, consistently with
\er{noninchgravortbstrjgghguittu2gjgghhjhghjhjgghgghghghtytythvfghfgghjgg},
consider in some cartesian coordinate system $(*)$ the second Law of
Newton for the moving continuum medium with the inertial mass
density $\mu$, the field of average (macroscopic) velocities $\vec
u$, the charge density $\rho$ and the electric current density $\vec
j$:
\begin{multline}\label{MaxVacFull1ninshtrgravortghhghgjkgghklhjgkghghjjkjhjkkggjkhjkhjjhhfhjhkjkhbbgjhzzzyyykkknnnNWBWHWPPNjj}
\mu\left(\frac{\partial\vec u}{\partial t}+d_{\vec x}\vec u\cdot
\vec u\right)=\frac{\partial}{\partial t}(\mu\vec u)+\Div_{\vec
x}\left\{\mu\vec u\otimes\vec u\right\}=\\-\mu\vec u\times
curl_{\vec x}\vec v+\mu\left(\partial_{t}\vec
v+\nabla_{\vec x}\frac{1}{2}|\vec v|^2\right)+\rho\vec E+\frac{1}{c}\vec j\times\vec B+\Div_{\vec x}\mathcal{T}=\\
=-\mu(\vec u-\vec v)\times curl_{\vec x}\vec v
+\mu\left(\partial_t\vec v+ d_{\vec x}\vec v\cdot\vec
v\right)+\rho\vec E+\frac{1}{c}\vec j\times\vec B+\Div_{\vec
x}\mathcal{T}.
\end{multline}
Here $\rho\vec E+\frac{1}{c}\vec j\times\vec B$ is the volume
density of the Lorentz force where $\vec E$ and $\vec B$ are outer
electric and magnetic fields, assumed to be changing smoothly and
almost constant in the microscopic level, $\vec v$ is a vectorial
gravitational potential also assumed to be changing smoothly and
almost constant in the microscopic level, and $\mathcal{T}\in
\R^{3\times 3}$ is the symmetric Cauchy stress tensor of the
continuum medium. Moreover, the mass density $\mu$, clearly
satisfies the continuum equation:
\begin{equation}\label{MaxVacFull1ninshtrgravortghhghgjkgghklhjgkghghjjkjhjkkggjkhjkhjjhhfhjhkhjjkhjgzzzyyykkknnnNWBWHWPPNjj}
\frac{\partial\mu}{\partial t}+div_{\vec x}\left(\mu\vec u\right)=0.
\end{equation}
In particular, multiplying
\er{MaxVacFull1ninshtrgravortghhghgjkgghklhjgkghghjjkjhjkkggjkhjkhjjhhfhjhkjkhbbgjhzzzyyykkknnnNWBWHWPPNjj}
by $\vec u$ and using
\er{MaxVacFull1ninshtrgravortghhghgjkgghklhjgkghghjjkjhjkkggjkhjkhjjhhfhjhkhjjkhjgzzzyyykkknnnNWBWHWPPNjj}
we deduce the equality of the balance of the kinetic energy:
\begin{multline}\label{MaxVacFull1ninshtrgravortghhghgjkgghklhjgkghghjjkjhjkkggjkhjkhjjhhfhjhkjkhbbgjhzzzyyykkknnnNWBWHWPPNjjnkjhbbj}
\frac{\partial}{\partial t}\left(\frac{\mu}{2}\left|\vec
u\right|^2\right)+\Div_{\vec x}\left\{\left(\frac{\mu}{2}\left|\vec
u\right|^2\right)\vec u\right\}=\mu\left(\frac{\partial}{\partial
t}\left(\frac{1}{2}\left|\vec u\right|^2\right)+\vec
u\cdot\nabla_{\vec x}\left(\frac{1}{2}\left|\vec
u\right|^2\right)\right)=\\ \mu\left(\partial_{t}\vec v+\nabla_{\vec
x}\frac{1}{2}|\vec v|^2\right)\cdot\vec u+\left(\rho\vec
E+\frac{1}{c}\vec j\times\vec B\right)\cdot\vec u+\left(\Div_{\vec
x}\mathcal{T}\right)\cdot\vec u =\\ \mu\left(\partial_{t}\vec
v+\nabla_{\vec x}\frac{1}{2}|\vec v|^2\right)\cdot\vec u+\rho\vec
E\cdot\vec u-\frac{1}{c}\left(\vec u\times\vec B\right)\cdot\vec
j+\left(\Div_{\vec x}\mathcal{T}\right)\cdot\vec u.
\end{multline}

Next, it is well known (see \cite{SM}), that the First Law of
Thermodynamics of this moving medium has the following form:
\begin{multline}\label{MaxVacFull1ninshtrgravortghhghgjkgghklhjgkghghjjkjhjkkggjkhjkhjjhhfhjhkjkhbbgjhzzzyyykkknnnNWBWHWPPNjjhjhjhjh}
\frac{\partial E}{\partial t}+\Div_{\vec x}\left\{E\vec
u\right\}=\mu\left(\frac{\partial}{\partial
t}\left(\frac{E}{\mu}\right)+\vec u\cdot\nabla_{\vec
x}\left(\frac{E}{\mu}\right)\right)\\ =\frac{1}{2}\left(d_{\vec
x}\vec u+\left\{d_{\vec x}\vec
u\right\}^T\right):\mathcal{T}-\Div_{\vec x}\vec q+\left(\vec
j-\rho\vec u\right)\cdot\left(\vec E+\frac{1}{c}\vec u\times\vec
B\right).
\end{multline}
Here $E$ is the volume density of the internal energy (energy per
unit volume) and consistently $\frac{E}{\mu}$ is the internal energy
per unit mass, $\vec q$ is the heat flux and
$$\left(\vec j-\rho\vec u\right)\cdot\left(\vec E+\frac{1}{c}\vec
u\times\vec B\right)$$ is the Joules heat term. In particular,
adding
\er{MaxVacFull1ninshtrgravortghhghgjkgghklhjgkghghjjkjhjkkggjkhjkhjjhhfhjhkjkhbbgjhzzzyyykkknnnNWBWHWPPNjjhjhjhjh}
with
\er{MaxVacFull1ninshtrgravortghhghgjkgghklhjgkghghjjkjhjkkggjkhjkhjjhhfhjhkjkhbbgjhzzzyyykkknnnNWBWHWPPNjjnkjhbbj}
and using the symmetry of $\mathcal{T}$, we deduce the following
equality of the balance of the energy:
\begin{multline}\label{MaxVacFull1ninshtrgravortghhghgjkgghklhjgkghghjjkjhjkkggjkhjkhjjhhfhjhkjkhbbgjhzzzyyykkknnnNWBWHWPPNjjnkjhbbjhjjhjh}
\frac{\partial}{\partial t}\left(\frac{\mu}{2}\left|\vec
u\right|^2+E\right)+\Div_{\vec
x}\left\{\left(\frac{\mu}{2}\left|\vec u\right|^2+E\right)\vec
u\right\}=\mu\left(\frac{\partial}{\partial
t}\left(\frac{1}{2}\left|\vec u\right|^2+\frac{E}{\mu}\right)+\vec
u\cdot\nabla_{\vec x}\left(\frac{1}{2}\left|\vec
u\right|^2+\frac{E}{\mu}\right)\right)\\= \Div_{\vec
x}\left(\mathcal{T}\cdot\vec u\right)-\Div_{\vec x}\vec q+
\mu\left(\partial_{t}\vec v+\nabla_{\vec x}\frac{1}{2}|\vec
v|^2\right)\cdot\vec u+\vec E\cdot\vec j.
\end{multline}

Next the Second Law of Thermodynamics states that
\begin{equation}\label{MaxVacFull1ninshtrgravortghhghgjkgghklhjgkghghjjkjhjkkggjkhjkhjjhhfhjhkjkhbbgjhzzzyyykkknnnNWBWHWPPNjjhjhjhjhkpi}
T\left(\frac{\partial \mathcal{S}}{\partial t}+\Div_{\vec
x}\left\{\mathcal{S}\vec
u\right\}\right)=T\mu\left(\frac{\partial}{\partial
t}\left(\frac{\mathcal{S}}{\mu}\right)+\vec u\cdot\nabla_{\vec
x}\left(\frac{\mathcal{S}}{\mu}\right)\right)\geq -\Div_{\vec x}\vec
q.
\end{equation}
Here $T:=T(\vec x,t)$ is the Kelvin temperature field and
$\mathcal{S}$ is the volume density of the entropy (entropy per unit
volume) and consistently $\frac{\mathcal{S}}{\mu}$ is the entropy
per unit mass. Moreover, we have the equality in
\er{MaxVacFull1ninshtrgravortghhghgjkgghklhjgkghghjjkjhjkkggjkhjkhjjhhfhjhkjkhbbgjhzzzyyykkknnnNWBWHWPPNjjhjhjhjhkpi}
in the case of reversible or quasi-reversible process. In the latter
case we rewrite the First Law
\er{MaxVacFull1ninshtrgravortghhghgjkgghklhjgkghghjjkjhjkkggjkhjkhjjhhfhjhkjkhbbgjhzzzyyykkknnnNWBWHWPPNjjhjhjhjh}
and the Second Law
\er{MaxVacFull1ninshtrgravortghhghgjkgghklhjgkghghjjkjhjkkggjkhjkhjjhhfhjhkjkhbbgjhzzzyyykkknnnNWBWHWPPNjjhjhjhjhkpi}
together as:
\begin{multline}\label{MaxVacFull1ninshtrgravortghhghgjkgghklhjgkghghjjkjhjkkggjkhjkhjjhhfhjhkjkhbbgjhzzzyyykkknnnNWBWHWPPNjjhjhjhjhkloioj}
\frac{\partial E}{\partial t}+\Div_{\vec x}\left\{E\vec u\right\}
=T\left(\frac{\partial \mathcal{S}}{\partial t}+\Div_{\vec
x}\left\{\mathcal{S}\vec u\right\}\right)+\frac{1}{2}\left(d_{\vec
x}\vec u+\left\{d_{\vec x}\vec
u\right\}^T\right):\mathcal{T}+\left(\vec j-\rho\vec
u\right)\cdot\left(\vec E+\frac{1}{c}\vec u\times\vec B\right)\\=
T\mu\left(\frac{\partial}{\partial
t}\left(\frac{\mathcal{S}}{\mu}\right)+\vec u\cdot\nabla_{\vec
x}\left(\frac{\mathcal{S}}{\mu}\right)\right)+
\frac{1}{2}\left(d_{\vec x}\vec u+\left\{d_{\vec x}\vec
u\right\}^T\right):\mathcal{T}+\left(\vec j-\rho\vec
u\right)\cdot\left(\vec E+\frac{1}{c}\vec u\times\vec B\right)
\\=\mu\left(\frac{\partial}{\partial
t}\left(\frac{E}{\mu}\right)+\vec u\cdot\nabla_{\vec
x}\left(\frac{E}{\mu}\right)\right) .
\end{multline}
In particular, if the stress tensor have the following particular
form
\begin{equation}\label{MaxVacFull1ninshtrgravortghhghgjkgghklhjgkghghjjkjhjkkggjkhjkhjjhhfhjhkhjjkhjgzzzyyykkknnnNWBWHWPPNjjhhjhj}
\mathcal{T}=-p\,I,
\end{equation}
where $p$ is the scalar pressure and $I:=Id\in\R^{3\times 3}$ is the
identity matrix, then since by
\er{MaxVacFull1ninshtrgravortghhghgjkgghklhjgkghghjjkjhjkkggjkhjkhjjhhfhjhkhjjkhjgzzzyyykkknnnNWBWHWPPNjjhhjhj}
and
\er{MaxVacFull1ninshtrgravortghhghgjkgghklhjgkghghjjkjhjkkggjkhjkhjjhhfhjhkhjjkhjgzzzyyykkknnnNWBWHWPPNjj}
we have
\begin{multline}\label{MaxVacFull1ninshtrgravortghhghgjkgghklhjgkghghjjkjhjkkggjkhjkhjjhhfhjhkjkhbbgjhzzzyyykkknnnNWBWHWPPNjjhjhjhjhkloiojjojj}
\frac{1}{2}\left(d_{\vec x}\vec u+\left\{d_{\vec x}\vec
u\right\}^T\right):\mathcal{T}=-p\left(\Div_{\vec x}\vec
u\right)=-p\mu\left(-\frac{1}{\mu^2}\right)\left(\frac{\partial\mu}{\partial
t}+\vec u\cdot\nabla_{\vec x}\mu\right)
\\=-p\mu\left(\frac{\partial}{\partial
t}\left(\frac{1}{\mu}\right)+\vec u\cdot\nabla_{\vec
x}\left(\frac{1}{\mu}\right)\right),
\end{multline}
in the latter case we rewrite the First Law of Thermodynamics
\er{MaxVacFull1ninshtrgravortghhghgjkgghklhjgkghghjjkjhjkkggjkhjkhjjhhfhjhkjkhbbgjhzzzyyykkknnnNWBWHWPPNjjhjhjhjh}
as:
\begin{multline}\label{MaxVacFull1ninshtrgravortghhghgjkgghklhjgkghghjjkjhjkkggjkhjkhjjhhfhjhkjkhbbgjhzzzyyykkknnnNWBWHWPPNjjhjhjhjhlllkl;kl}
\frac{\partial}{\partial t}\left(\frac{E}{\mu}\right)+\vec
u\cdot\nabla_{\vec x}\left(\frac{E}{\mu}\right)
=-p\left(\frac{\partial}{\partial t}\left(\frac{1}{\mu}\right)+\vec
u\cdot\nabla_{\vec
x}\left(\frac{1}{\mu}\right)\right)-\frac{1}{\mu}\Div_{\vec x}\vec
q+\frac{1}{\mu}\left(\vec j-\rho\vec u\right)\cdot\left(\vec
E+\frac{1}{c}\vec u\times\vec B\right),
\end{multline}
and in the case of quasi-reversible process we rewrite
\er{MaxVacFull1ninshtrgravortghhghgjkgghklhjgkghghjjkjhjkkggjkhjkhjjhhfhjhkjkhbbgjhzzzyyykkknnnNWBWHWPPNjjhjhjhjhkloioj}
as:
\begin{multline}\label{MaxVacFull1ninshtrgravortghhghgjkgghklhjgkghghjjkjhjkkggjkhjkhjjhhfhjhkjkhbbgjhzzzyyykkknnnNWBWHWPPNjjhjhjhjhlllklkljkkhhkgk}
\frac{\partial}{\partial t}\left(\frac{E}{\mu}\right)+\vec
u\cdot\nabla_{\vec x}\left(\frac{E}{\mu}\right)
=-p\left(\frac{\partial}{\partial t}\left(\frac{1}{\mu}\right)+\vec
u\cdot\nabla_{\vec
x}\left(\frac{1}{\mu}\right)\right)+T\left(\frac{\partial}{\partial
t}\left(\frac{\mathcal{S}}{\mu}\right)+\vec u\cdot\nabla_{\vec
x}\left(\frac{\mathcal{S}}{\mu}\right)\right)\\+\frac{1}{\mu}\left(\vec
j-\rho\vec u\right)\cdot\left(\vec E+\frac{1}{c}\vec u\times\vec
B\right),
\end{multline}
where clearly $\frac{1}{\mu}$ is the volume per unit mass. Moreover,
the following inequality always holds:
\begin{equation}\label{MaxVacFull1ninshtrgravortghhghgjkgghklhjgkghghjjkjhjkkggjkhjkhjjhhfhjhkhjjkhjgzzzyyykkknnnNWBWHWPPNjjlkljjkkkjkjk}
\vec q\cdot\nabla_{\vec x}T\leq 0.
\end{equation}
In particular, by inserting
\er{MaxVacFull1ninshtrgravortghhghgjkgghklhjgkghghjjkjhjkkggjkhjkhjjhhfhjhkhjjkhjgzzzyyykkknnnNWBWHWPPNjjlkljjkkkjkjk}
into
\er{MaxVacFull1ninshtrgravortghhghgjkgghklhjgkghghjjkjhjkkggjkhjkhjjhhfhjhkjkhbbgjhzzzyyykkknnnNWBWHWPPNjjhjhjhjhkpi}
we deduce
\begin{multline}\label{MaxVacFull1ninshtrgravortghhghgjkgghklhjgkghghjjkjhjkkggjkhjkhjjhhfhjhkjkhbbgjhzzzyyykkknnnNWBWHWPPNjjhjhjhjhkpihjjgjggj}
\frac{\partial \mathcal{S}}{\partial t}+\Div_{\vec
x}\left\{\mathcal{S}\vec u\right\}=\mu\left(\frac{\partial}{\partial
t}\left(\frac{\mathcal{S}}{\mu}\right)+\vec u\cdot\nabla_{\vec
x}\left(\frac{\mathcal{S}}{\mu}\right)\right)\geq
-\frac{1}{T}\Div_{\vec x}\vec q\geq\\ -\frac{1}{T}\Div_{\vec x}\vec
q+\frac{1}{T^2}\nabla_{\vec x}T\cdot\vec q=-\Div_{\vec
x}\left(\frac{1}{T}\,\vec q\right),
\end{multline}
i.e.
\begin{equation}\label{MaxVacFull1ninshtrgravortghhghgjkgghklhjgkghghjjkjhjkkggjkhjkhjjhhfhjhkjkhbbgjhzzzyyykkknnnNWBWHWPPNjjhjhjhjhkpihjjgjggjkhhh}
\frac{\partial \mathcal{S}}{\partial t}+\Div_{\vec
x}\left\{\mathcal{S}\vec u\right\}=\mu\left(\frac{\partial}{\partial
t}\left(\frac{\mathcal{S}}{\mu}\right)+\vec u\cdot\nabla_{\vec
x}\left(\frac{\mathcal{S}}{\mu}\right)\right)\geq -\Div_{\vec
x}\left(\frac{1}{T}\,\vec q\right).
\end{equation}
Note that by
\er{MaxVacFull1ninshtrgravortghhghgjkgghklhjgkghghjjkjhjkkggjkhjkhjjhhfhjhkjkhbbgjhzzzyyykkknnnNWBWHWPPNjjhjhjhjhkpihjjgjggjkhhh},
the total entropy of an arbitrary moving \underline{thermally
isolated} system is a non-decreasing function of time. Indeed,
integrating
\er{MaxVacFull1ninshtrgravortghhghgjkgghklhjgkghghjjkjhjkkggjkhjkhjjhhfhjhkjkhbbgjhzzzyyykkknnnNWBWHWPPNjjhjhjhjhkpihjjgjggjkhhh}
on the region $\Omega(t)$, occupied by our system at the instant of
time $t$, and using the Divergence Theorem, gives:
\begin{multline}\label{MaxVacFull1ninshtrgravortghhghgjkgghklhjgkghghjjkjhjkkggjkhjkhjjhhfhjhkjkhbbgjhzzzyyykkknnnNWBWHWPPNjjhjhjhjhkpihjjgjggjkhhhjgjghghhjjh}
\frac{d}{dt}\left(\int_{\Omega(t)}\mathcal{S}(\vec x,t)\,d\vec
x\right)=\int_{\Omega(t)}\frac{\partial \mathcal{S}}{\partial
t}(\vec x,t)\,d\vec x+\int_{\partial\Omega(t)}\mathcal{S}(\vec
x,t)\,\vec u(\vec x,t)\cdot\vec n(\vec x,t)\,ds_{\vec x}
\\=\int_{\Omega(t)}\mu\left(\frac{\partial}{\partial
t}\left(\frac{\mathcal{S}}{\mu}\right)+\vec u\cdot\nabla_{\vec
x}\left(\frac{\mathcal{S}}{\mu}\right)\right)d\vec x
 \geq
-\int_{\partial\Omega(t)}\frac{1}{T(\vec x,t)}\,\vec q(\vec
x,t)\cdot\vec n(\vec x,t)\,ds_{\vec x},
\end{multline}
where $\vec n:=\vec n(\vec x,t)$ is an outward normal to the
boundary $\partial\Omega(t)$ of the region $\Omega(t)$. On the other
hand, as our system is thermally isolated, the heat flux $\vec q$ on
the boundary of $\Omega(t)$ vanishes. So, the right hand side of
inequality
\er{MaxVacFull1ninshtrgravortghhghgjkgghklhjgkghghjjkjhjkkggjkhjkhjjhhfhjhkjkhbbgjhzzzyyykkknnnNWBWHWPPNjjhjhjhjhkpihjjgjggjkhhhjgjghghhjjh}
is negligible and thus the total entropy of the system is a
non-decreasing function of time.

Finally, we remind the approximate Fourier's law (which is
consistent with
\er{MaxVacFull1ninshtrgravortghhghgjkgghklhjgkghghjjkjhjkkggjkhjkhjjhhfhjhkhjjkhjgzzzyyykkknnnNWBWHWPPNjjlkljjkkkjkjk}):
\begin{equation}\label{MaxVacFull1ninshtrgravortghhghgjkgghklhjgkghghjjkjhjkkggjkhjkhjjhhfhjhkhjjkhjgzzzyyykkknnnNWBWHWPPNjjlkljj}
\vec q=-\chi\nabla_{\vec x}T,
\end{equation}
where $\chi$ is some positive material coefficient (not necessary a
constant). The generalization of
\er{MaxVacFull1ninshtrgravortghhghgjkgghklhjgkghghjjkjhjkkggjkhjkhjjhhfhjhkhjjkhjgzzzyyykkknnnNWBWHWPPNjjlkljj}
to anisotropic mediums is the following:
\begin{equation}\label{MaxVacFull1ninshtrgravortghhghgjkgghklhjgkghghjjkjhjkkggjkhjkhjjhhfhjhkhjjkhjgzzzyyykkknnnNWBWHWPPNjjlkljjhjhjhjj}
\vec q=-\mathcal{R}\cdot\nabla_{\vec x}T,
\end{equation}
where $\mathcal{R}\in\mathbb{R}^{3\times 3}$ is a \underline{proper}
matrix valued field, such that
\begin{equation}\label{MaxVacFull1ninshtrgravortghhghgjkgghklhjgkghghjjkjhjkkggjkhjkhjjhhfhjhkhjjkhjgzzzyyykkknnnNWBWHWPPNjjlkljjhjhjhjjhhh}
\vec a\cdot\left(\mathcal{R}\cdot\vec a\right)\geq
0\quad\quad\forall\, \vec a\,\in\,\mathbb{R}^3.
\end{equation}

Next consider the change of certain non-inertial cartesian
coordinate system $(*)$ to another cartesian coordinate system
$(**)$ as:
\begin{equation}\label{noninchredPPNbjjhjjkhkh}
\begin{cases}
\vec x'=A(t)\cdot \vec x+\vec z(t),\\
t'=t,
\end{cases}
\end{equation}
where $A(t)\in SO(3)$ is a rotation. Then by Proposition
\ref{yghgjtgyrtrt} we easily deduce that the Laws in
\er{MaxVacFull1ninshtrgravortghhghgjkgghklhjgkghghjjkjhjkkggjkhjkhjjhhfhjhkjkhbbgjhzzzyyykkknnnNWBWHWPPNjjhjhjhjh},
\er{MaxVacFull1ninshtrgravortghhghgjkgghklhjgkghghjjkjhjkkggjkhjkhjjhhfhjhkjkhbbgjhzzzyyykkknnnNWBWHWPPNjjhjhjhjhkpi},
\er{MaxVacFull1ninshtrgravortghhghgjkgghklhjgkghghjjkjhjkkggjkhjkhjjhhfhjhkjkhbbgjhzzzyyykkknnnNWBWHWPPNjjhjhjhjhkloioj},
\er{MaxVacFull1ninshtrgravortghhghgjkgghklhjgkghghjjkjhjkkggjkhjkhjjhhfhjhkhjjkhjgzzzyyykkknnnNWBWHWPPNjjhhjhj},
\er{MaxVacFull1ninshtrgravortghhghgjkgghklhjgkghghjjkjhjkkggjkhjkhjjhhfhjhkjkhbbgjhzzzyyykkknnnNWBWHWPPNjjhjhjhjhlllkl;kl},
\er{MaxVacFull1ninshtrgravortghhghgjkgghklhjgkghghjjkjhjkkggjkhjkhjjhhfhjhkjkhbbgjhzzzyyykkknnnNWBWHWPPNjjhjhjhjhlllklkljkkhhkgk},
\er{MaxVacFull1ninshtrgravortghhghgjkgghklhjgkghghjjkjhjkkggjkhjkhjjhhfhjhkhjjkhjgzzzyyykkknnnNWBWHWPPNjjlkljjkkkjkjk},
\er{MaxVacFull1ninshtrgravortghhghgjkgghklhjgkghghjjkjhjkkggjkhjkhjjhhfhjhkhjjkhjgzzzyyykkknnnNWBWHWPPNjjlkljj},
\er{MaxVacFull1ninshtrgravortghhghgjkgghklhjgkghghjjkjhjkkggjkhjkhjjhhfhjhkjkhbbgjhzzzyyykkknnnNWBWHWPPNjjhjhjhjhkpihjjgjggjkhhh}
and
\er{MaxVacFull1ninshtrgravortghhghgjkgghklhjgkghghjjkjhjkkggjkhjkhjjhhfhjhkhjjkhjgzzzyyykkknnnNWBWHWPPNjjlkljjhjhjhjjhhh}
are invariant under the change of a non-inertial cartesian
coordinate system given by \er{noninchredPPNbjjhjjkhkh}, provided
that under \er{noninchredPPNbjjhjjkhkh} we have:
\begin{equation}\label{hjgjgjgkllujk}
\begin{cases}\mu'=\mu,\\
E'=E,\\
\mathcal{S}'=\mathcal{S},\\
T'=T,\\
\vec q'=A(t)\cdot\vec q,\\
\mathcal{T}'=A(t)\cdot \mathcal{T}\cdot A^T(t)
\\
p'=p,\\
\chi'=\chi,\\
\mathcal{R}'=A(t)\cdot \mathcal{R}\cdot A^T(t),
\\
\rho'=\rho,\\
\vec u'=A(t)\cdot \vec u+\frac{dA}{dt}(t)\cdot\vec x+\frac{d\vec z}{dt}(t),\\
\vec v'=A(t)\cdot \vec v+\frac{dA}{dt}(t)\cdot\vec x+\frac{d\vec z}{dt}(t),\\
\vec j'=A(t)\cdot \vec j+\rho \frac{dA}{dt}(t)\cdot\vec
x+\rho\frac{d\vec z}{dt}(t).
\end{cases}
\end{equation}

\subsection{Some special cases of continuum mediums}
\subsubsection{The case of an inviscid fluid/gas}
In the case of an inviscid fluid or gas equality
\er{MaxVacFull1ninshtrgravortghhghgjkgghklhjgkghghjjkjhjkkggjkhjkhjjhhfhjhkhjjkhjgzzzyyykkknnnNWBWHWPPNjjhhjhj}
indeed holds. As a consequence, equality
\er{MaxVacFull1ninshtrgravortghhghgjkgghklhjgkghghjjkjhjkkggjkhjkhjjhhfhjhkjkhbbgjhzzzyyykkknnnNWBWHWPPNjjhjhjhjhlllkl;kl}
holds, and moreover, in the case of quasi-reversible process
equality
\er{MaxVacFull1ninshtrgravortghhghgjkgghklhjgkghghjjkjhjkkggjkhjkhjjhhfhjhkjkhbbgjhzzzyyykkknnnNWBWHWPPNjjhjhjhjhlllklkljkkhhkgk}
also holds. Moreover, in the case of a classical ideal gas the
following state equality is well known:
\begin{equation}\label{noninchredPPNbjjhjjkhkhlkkkkl}
p\,=\,\frac{\mu}{m_0}\,k\,T,
\end{equation}
where $m_0$ is the mass of the single molecule of the given gas and
$k$ is the Boltzmann constant. Finally, for the ideal gas, we have
the following expression for the volume density of the internal
energy $E$:
\begin{equation}\label{noninchredPPNbjjhjjkhkhkljjljljkl}
E\,=\,\frac{\mu}{m_0}\,\gamma_0\,k\,T,
\end{equation}
where $\gamma_0>0$ is a constant that depends on the kind of the gas
(for the monatomic gas we have $\gamma_0=\frac{3}{2}$). On the other
hand, in the case of \underline{incompressible} fluid we have
\begin{equation}\label{noninchredPPNbjjhjjkhkhkljjljljklyjhfj}
div_{\vec x}\vec u\,\equiv \,0,
\end{equation}
and the pressure $p$ is unspecified.

Then, as before, we easily deduce that the Laws in
\er{noninchredPPNbjjhjjkhkhlkkkkl},
\er{noninchredPPNbjjhjjkhkhkljjljljkl} and
\er{noninchredPPNbjjhjjkhkhkljjljljklyjhfj} are invariant under the
change of a non-inertial cartesian coordinate system given by
\er{noninchredPPNbjjhjjkhkh}, provided that under
\er{noninchredPPNbjjhjjkhkh} we have \er{hjgjgjgkllujk}.

\subsubsection{The simplest viscous fluid/gas}\label{jghgfhhfhfkhhj}
In the case of the simplest viscous fluid or gas we have the
following equality, that substitutes
\er{MaxVacFull1ninshtrgravortghhghgjkgghklhjgkghghjjkjhjkkggjkhjkhjjhhfhjhkhjjkhjgzzzyyykkknnnNWBWHWPPNjjhhjhj}:
\begin{equation}\label{MaxVacFull1ninshtrgravortghhghgjkgghklhjgkghghjjkjhjkkggjkhjkhjjhhfhjhkhjjkhjgzzzyyykkknnnNWBWHWPPNjjhhjhjkpkl;kl;}
\mathcal{T}=-p\,I+\left(\alpha\left(d_{\vec x}\vec u+\left\{d_{\vec
x}\vec u\right\}^T\right)+\beta\left(div_{\vec x}\vec
u\right)I\right),
\end{equation}
where $\alpha\geq 0$ and $\beta$ are some material coefficients.
Then, as in
\er{MaxVacFull1ninshtrgravortghhghgjkgghklhjgkghghjjkjhjkkggjkhjkhjjhhfhjhkjkhbbgjhzzzyyykkknnnNWBWHWPPNjjhjhjhjhlllkl;kl}
and
\er{MaxVacFull1ninshtrgravortghhghgjkgghklhjgkghghjjkjhjkkggjkhjkhjjhhfhjhkjkhbbgjhzzzyyykkknnnNWBWHWPPNjjhjhjhjhlllklkljkkhhkgk},
by
\er{MaxVacFull1ninshtrgravortghhghgjkgghklhjgkghghjjkjhjkkggjkhjkhjjhhfhjhkhjjkhjgzzzyyykkknnnNWBWHWPPNjjhhjhjkpkl;kl;}
we rewrite the First Law of Thermodynamics
\er{MaxVacFull1ninshtrgravortghhghgjkgghklhjgkghghjjkjhjkkggjkhjkhjjhhfhjhkjkhbbgjhzzzyyykkknnnNWBWHWPPNjjhjhjhjh}
as:
\begin{multline}\label{MaxVacFull1ninshtrgravortghhghgjkgghklhjgkghghjjkjhjkkggjkhjkhjjhhfhjhkjkhbbgjhzzzyyykkknnnNWBWHWPPNjjhjhjhjhlllkl;kljm,n,}
\frac{\partial}{\partial t}\left(\frac{E}{\mu}\right)+\vec
u\cdot\nabla_{\vec x}\left(\frac{E}{\mu}\right)
=\frac{\alpha}{2}\left|d_{\vec x}\vec u+\left\{d_{\vec x}\vec
u\right\}^T\right|^2+\frac{\beta}{2}\left|div_{\vec x}\vec
u\right|^2
\\-p\left(\frac{\partial}{\partial
t}\left(\frac{1}{\mu}\right)+\vec u\cdot\nabla_{\vec
x}\left(\frac{1}{\mu}\right)\right)-\frac{1}{\mu}\Div_{\vec x}\vec
q+\frac{1}{\mu}\left(\vec j-\rho\vec u\right)\cdot\left(\vec
E+\frac{1}{c}\vec u\times\vec B\right),
\end{multline}
and in the case of quasi-reversible process we rewrite
\er{MaxVacFull1ninshtrgravortghhghgjkgghklhjgkghghjjkjhjkkggjkhjkhjjhhfhjhkjkhbbgjhzzzyyykkknnnNWBWHWPPNjjhjhjhjhkloioj}
as:
\begin{multline}\label{MaxVacFull1ninshtrgravortghhghgjkgghklhjgkghghjjkjhjkkggjkhjkhjjhhfhjhkjkhbbgjhzzzyyykkknnnNWBWHWPPNjjhjhjhjhlllklkljkkhhkgklkjjljljl}
\frac{\partial}{\partial t}\left(\frac{E}{\mu}\right)+\vec
u\cdot\nabla_{\vec x}\left(\frac{E}{\mu}\right)
=\frac{\alpha}{2}\left|d_{\vec x}\vec u+\left\{d_{\vec x}\vec
u\right\}^T\right|^2+\frac{\beta}{2}\left|div_{\vec x}\vec
u\right|^2\\-p\left(\frac{\partial}{\partial
t}\left(\frac{1}{\mu}\right)+\vec u\cdot\nabla_{\vec
x}\left(\frac{1}{\mu}\right)\right)+T\left(\frac{\partial}{\partial
t}\left(\frac{\mathcal{S}}{\mu}\right)+\vec u\cdot\nabla_{\vec
x}\left(\frac{\mathcal{S}}{\mu}\right)\right)+\frac{1}{\mu}\left(\vec
j-\rho\vec u\right)\cdot\left(\vec E+\frac{1}{c}\vec u\times\vec
B\right),
\end{multline}
Moreover, in the case of a classical ideal gas the equalities
\er{noninchredPPNbjjhjjkhkhlkkkkl} and
\er{noninchredPPNbjjhjjkhkhkljjljljkl} are still valid in the case
of viscous flow. On the other hand, in the case of
\underline{incompressible} viscous fluid we have
\er{noninchredPPNbjjhjjkhkhkljjljljklyjhfj} and the pressure $p$ is
also unspecified.

Then, as before, by Proposition \ref{yghgjtgyrtrt} we easily deduce
that the Laws in
\er{MaxVacFull1ninshtrgravortghhghgjkgghklhjgkghghjjkjhjkkggjkhjkhjjhhfhjhkhjjkhjgzzzyyykkknnnNWBWHWPPNjjhhjhjkpkl;kl;},
\er{MaxVacFull1ninshtrgravortghhghgjkgghklhjgkghghjjkjhjkkggjkhjkhjjhhfhjhkjkhbbgjhzzzyyykkknnnNWBWHWPPNjjhjhjhjhlllkl;kljm,n,},
\er{MaxVacFull1ninshtrgravortghhghgjkgghklhjgkghghjjkjhjkkggjkhjkhjjhhfhjhkjkhbbgjhzzzyyykkknnnNWBWHWPPNjjhjhjhjhlllklkljkkhhkgklkjjljljl},
\er{noninchredPPNbjjhjjkhkhlkkkkl},
\er{noninchredPPNbjjhjjkhkhkljjljljkl} and
\er{noninchredPPNbjjhjjkhkhkljjljljklyjhfj} are invariant under the
change of a non-inertial cartesian coordinate system given by
\er{noninchredPPNbjjhjjkhkh}, provided that under
\er{noninchredPPNbjjhjjkhkh} we have \er{hjgjgjgkllujk}.

Finally, in the case of an anisotropic Newtonian fluid we have the
following generalization of the viscosity law
\er{MaxVacFull1ninshtrgravortghhghgjkgghklhjgkghghjjkjhjkkggjkhjkhjjhhfhjhkhjjkhjgzzzyyykkknnnNWBWHWPPNjjhhjhjkpkl;kl;}:
\begin{equation}\label{MaxVacFull1ninshtrgravortghhghgjkgghklhjgkghghjjkjhjkkggjkhjkhjjhhfhjhkhjjkhjgzzzyyykkknnnNWBWHWPPNjjhhjhjkpkl;kl;aa}
\mathcal{T}=-p\,I+\Theta\cdot\left(d_{\vec x}\vec u+\left\{d_{\vec
x}\vec u\right\}^T\right),
\end{equation}
where $\Theta:=\Theta(\vec x,t)$ is a linear operator ($9\times 9$
matrix) from $\mathbb{R}^{3\times 3}$ to $\mathbb{R}^{3\times 3}$,
such that
$$(\Theta\cdot G)^T=\Theta\cdot G\quad\quad\forall \,G\in \mathbb{R}^{3\times 3}\quad\text{such that}\quad G^T=G.$$
Then, using Lemma \ref{khhjgjgggh}, by \er{hjgjgjgkllujk} we easily
deduce that the law in
\er{MaxVacFull1ninshtrgravortghhghgjkgghklhjgkghghjjkjhjkkggjkhjkhjjhhfhjhkhjjkhjgzzzyyykkknnnNWBWHWPPNjjhhjhjkpkl;kl;aa}
is invariant under the change of a non-inertial cartesian coordinate
system given by \er{noninchredPPNbjjhjjkhkh}, provided that, under
\er{noninchredPPNbjjhjjkhkh} we have:
\begin{equation}\label{hjgjgjgkllujkjkhkkkjklaassaa}
\begin{cases}
\mathcal{T}'=A(t)\cdot \mathcal{T}\cdot A^T(t),\\
\Theta'=\left(A(t)\otimes A(t)\right)\cdot\Theta\cdot
\left(A^T(t)\otimes A^T(t)\right),
\end{cases}
\end{equation}
where the sign $\otimes$ in \er{hjgjgjgkllujkjkhkkkjklaassaa} means
the tensor product of the matrices, i.e. for given two linear
operators (matrices) $A\in\mathbb{R}^{3\times 3}$ and $B\in
\mathbb{R}^{3\times 3}$
their tensor product
$A\otimes B$ is a linear operator from $\mathbb{R}^{3\times 3}$ to
$\mathbb{R}^{3\times 3}$, defined by the identity:
\begin{equation}\label{fyfgjguyiuiHHHHHHkjjjjlzzaassaa}
(A\otimes B)\cdot(\vec a\otimes \vec b)=(A\cdot \vec a)\otimes
(B\cdot \vec b)\quad\quad\forall \vec a\in\mathbb{R}^{3},\;\forall
\vec b\in\mathbb{R}^{3}.
\end{equation}

\subsection{Lagrangian coordinates and the simplest models of elastic
bodies}\label{hjgyujtgjtututtu} In some cartesian coordinate system
consider a motion of some continuum medium occupying a region
$\Omega\subset\mathbb{R}^3$ at some fixed instant of time $t=t_0$
and having the velocity field $\vec u:=\vec u(\vec x,t)$. Next let
$\vec r(t,\vec y):\mathbb{R}\times\Omega\to\mathbb{R}^3$ be a
solution of the following initial value problem for an ordinary
differential equation:
\begin{equation}\label{hjgghhghhffhkk}
\begin{cases}
\frac{\partial\vec r}{\partial t}(t,\vec y)=\vec u\left(\vec
r(t,\vec y),t\right)\quad\quad\forall t\in\mathbb{R},\;\;\forall\vec
y\in\Omega\\
\vec r(t_0,\vec y)=\vec y\quad\quad\forall\vec y\in\Omega.
\end{cases}
\end{equation}
Then, clearly $\vec x=\vec r(t,\vec y)$ stands for the spatial
coordinates at the instant of time $t$ of the parcel of continuum,
having initial coordinates $\vec y$. Moreover, by
\er{hjgghhghhffhkk} the matrix-valued function $d_{\vec y}\vec
r(t,\vec y)$ satisfies the following initial value problem for a
matrix ordinary differential equation:
\begin{equation}\label{hjgghhghhffhkkihiyjh}
\begin{cases}
\frac{\partial}{\partial t}\left(d_{\vec y}\vec r(t,\vec
y)\right)=\left(d_{\vec x}\vec u\left(\vec r(t,\vec
y),t\right)\right)\cdot d_{\vec y}\vec r(t,\vec y)\quad\quad\forall
t\in\mathbb{R},\;\;\forall\vec
y\in\Omega\\
d_{\vec y}\vec r(t_0,\vec y)=I\quad\quad\vec y\in\Omega.
\end{cases}
\end{equation}
In particular, using the Liouville Theorem in the theory of linear
ordinary differential systems by \er{hjgghhghhffhkkihiyjh} we deduce
that $det\left\{d_{\vec y}\vec r(t,\vec y)\right\}$ satisfies
\begin{equation}\label{hjgghhghhffhkkihiyjhkpkpkbb}
det\left\{d_{\vec y}\vec r(t,\vec
y)\right\}=e^{\int_{t_0}^{t}\left(div_{\vec x}\vec u\left(\vec
r(\tau,\vec y),\tau\right)\right)d\tau}\quad\quad\forall
t\in\mathbb{R},\;\;\forall\vec y\in\Omega.
\end{equation}
Thus
by \er{hjgghhghhffhkkihiyjhkpkpkbb} we deduce that
$det\left\{d_{\vec y}\vec r(t,\vec y)\right\}\neq 0$ for every
instant of time $t$ and so, for the given instant of time $t$ the
mapping $\vec x=\vec r(t,\vec y)$ is locally invertible i.e. the
equation $\vec x=\vec r(t,\vec y)$ can be resolved in $\vec y$. Thus
there exists a regular mapping $\vec y=\vec f(\vec x,t)$, such that
\begin{equation}\label{hjgghhghhffhkkihiyjhjkhjhjh}
\vec f\left(\vec r(t,\vec y),t\right)=\vec y\quad\forall \vec
y\in\Omega\quad\quad\text{and}\quad\quad\vec r\left(t,\vec f(\vec
x,t)\right)=\vec x.
\end{equation}
Then clearly $\vec y=\vec f(\vec x,t)$ stands for the initial
coordinates at the time $t_0$ of the parcel of continuum, having
coordinates $\vec x$ at the instant of time $t$. Next
differentiating the first equation in
\er{hjgghhghhffhkkihiyjhjkhjhjh} by $t$ due to the chain rule and
using \er{hjgghhghhffhkk} we deduce:
\begin{equation}\label{hjgghhghhffhkkihiyjhjkhjhjhojujjkh}
0=\frac{\partial\vec f}{\partial t}\left(\vec r(t,\vec
y),t\right)+d_{\vec x}\vec f\left(\vec r(t,\vec
y),t\right)\cdot\frac{\partial\vec r}{\partial t}(t,\vec
y)=\frac{\partial\vec f}{\partial t}\left(\vec r(t,\vec
y),t\right)+d_{\vec x}\vec f\left(\vec r(t,\vec y),t\right)\cdot\vec
u\left(\vec r(t,\vec y),t\right)\quad\forall \vec y\in\Omega.
\end{equation}
So $\vec f(\vec x,t)$ satisfies:
\begin{equation}\label{hjgghhghhffhkkmkljjnnnn}
\begin{cases}
\frac{\partial\vec f}{\partial t}\left(\vec x,t\right)+d_{\vec
x}\vec f\left(\vec x,t\right)\cdot\vec u\left(\vec x,t\right)=0
\\
\vec f(\vec x,t_0)=\vec x.
\end{cases}
\end{equation}
The \underline{cartesian} coordinates $(\vec x,t)$ are called the
Eulerian coordinates of the continuum medium. In contrast, change of
variables
\begin{equation}\label{hjgghhghhffhkkmkljjnnnnoujku}
\begin{cases}
t=t\\
\vec y=\vec f\left(\vec x,t\right)
\end{cases}
\end{equation}
to the new coordinates $(\vec y,t)$ of the space-time leads to
generally non-cartesian \underline{curvilinear} coordinates that
called the Lagrangian or the reference coordinates of the continuum
medium. Next note that in the case of a given scalar field in the
Eulerian coordinates $\Theta=\Theta(\vec x,t)$ and the corresponding
field in Lagrangian coordinates $\Theta_1(\vec
y,t):=\Theta\left(\vec r(t,\vec y),t\right)$, due to the chain rule
and using \er{hjgghhghhffhkk} we have:
\begin{multline}\label{hjgghhghhffhkkihiyjhjkhjhjhojujjkhjjklkjkjgbnvv}
\frac{\partial\Theta_1}{\partial t}(\vec
y,t)=\frac{\partial\Theta}{\partial t}\left(\vec r(t,\vec
y),t\right)+\nabla_{\vec x}\Theta\left(\vec r(t,\vec
y),t\right)\cdot\frac{\partial\vec r}{\partial t}(t,\vec
y)\\=\frac{\partial\Theta}{\partial t}\left(\vec r(t,\vec
y),t\right)+\vec u\left(\vec r(t,\vec y),t\right)\cdot\nabla_{\vec
x}\Theta\left(\vec r(t,\vec y),t\right)\quad\forall \vec y\in\Omega.
\end{multline}
So for  $\frac{\partial\Theta_1}{\partial t}(\vec y,t)$ in the
Lagrangian coordinates corresponds the following expression in the
Eulerian coordinates:
\begin{equation}\label{hjgghhghhffhkkmkljjnnnnoujkujjjjljj}
\frac{\partial\Theta_1}{\partial t}(\vec
y,t)=\frac{\partial\Theta}{\partial t}\left(\vec x,t\right)+\vec
u\left(\vec x,t\right)\cdot\nabla_{\vec x}\Theta\left(\vec
x,t\right).
\end{equation}
Thus by \er{hjgghhghhffhkkmkljjnnnnoujkujjjjljj} we also obtain that
in the case of a given vector field in the Eulerian coordinates
$\vec g=\vec g(\vec x,t)$ and the corresponding field in Lagrangian
coordinates $\vec g_1(\vec y,t):=\vec g\left(\vec x,t\right)=\vec
g\left(\vec r(t,\vec y),t\right)$ for $\frac{\partial\vec
g_1}{\partial t}(\vec y,t)$ in the Lagrangian coordinates
corresponds the following expression in the Eulerian coordinates:
\begin{equation}\label{hjgghhghhffhkkmkljjnnnnoujkujjjjljjkhhjkjk}
\frac{\partial\vec g_1}{\partial t}(\vec y,t)=\frac{\partial\vec
g}{\partial t}\left(\vec x,t\right)+ d_{\vec x}\vec g\left(\vec
x,t\right)\cdot \vec u\left(\vec x,t\right).
\end{equation}

Next, as before, assume that the change of some non-inertial
cartesian system $(*)$ of Eulerian coordinates to another cartesian
system $(**)$ of Eulerian coordinates is of the form:
\begin{equation}\label{jjjjkkkk}
\begin{cases}
\vec x'=A(t)\cdot\vec x+\vec z(t),\\
t'=t,
\end{cases}
\end{equation}
where $A(t)\in SO(3)$. We would like to derive that the law of
transformation of the Lagrangian coordinates $(\vec y,t)\to(\vec
y',t')$, consistent with \er{jjjjkkkk}, is the following:
\begin{equation}\label{jjjjkkkkkhhhhjhjh}
\begin{cases}
\vec y'=A(t_0)\cdot\vec y+\vec z(t_0),\\
t'=t,
\end{cases}
\end{equation}
Indeed consistently with \er{jjjjkkkk} we have
\begin{equation}\label{hjgjgjgkllujkjjj}
\vec u'(\vec x',t)=A(t)\cdot \vec u(\vec
x,t)+\frac{dA}{dt}(t)\cdot\vec x+\frac{d\vec z}{dt}(t).
\end{equation}
Thus if we define $\vec r'(t',\vec
y'):\mathbb{R}\times\mathbb{R}^3\to\mathbb{R}^3$ as:
\begin{equation}\label{hjgjgjgkllujkjjjkhhjh}
\vec r'(t',\vec y')=\vec r'\left(t,A(t_0)\cdot\vec y+\vec
z(t_0)\right):=A(t)\cdot\vec r(t,\vec y)+\vec z(t),
\end{equation}
then by \er{hjgghhghhffhkk} we obtain firstly
\begin{equation}\label{hjgjgjgkllujkjjjkhhjhjhhhj}
\vec r'(t_0,\vec y')=\vec r'\left(t_0,A(t_0)\cdot\vec y+\vec
z(t_0)\right)=A(t_0)\cdot\vec r(t_0,\vec y)+\vec
z(t_0)=A(t_0)\cdot\vec y+\vec z(t_0)=\vec y',
\end{equation}
and secondly with the help of \er{hjgjgjgkllujkjjjkhhjh} and
\er{hjgjgjgkllujkjjj}:
\begin{multline}\label{hjgjgjgkllujkjjjkhhjhklkljkl}
\frac{\partial\vec r'}{\partial t'}(t',\vec y')=\frac{\partial\vec
r'}{\partial t}\left(t,A(t_0)\cdot\vec y+\vec
z(t_0)\right)=A(t)\cdot \frac{\partial\vec r}{\partial
t}\left(t,\vec y\right)+\frac{dA}{dt}(t)\cdot\vec r\left(t,\vec
y\right)+\frac{d\vec z}{dt}(t)\\= A(t)\cdot \vec u\left(\vec
r(t,\vec y),t\right)+\frac{dA}{dt}(t)\cdot\vec r\left(t,\vec
y\right)+\frac{d\vec z}{dt}(t)=\vec u'\left(\vec r'(t',\vec
y'),t'\right).
\end{multline}
So, by \er{hjgjgjgkllujkjjjkhhjhjhhhj} and
\er{hjgjgjgkllujkjjjkhhjhklkljkl} consistently with
\er{hjgghhghhffhkk} we have
\begin{equation}\label{hjgghhghhffhkklkjljjkjkkjl}
\begin{cases}
\frac{\partial\vec r'}{\partial t'}(t',\vec y')=\vec u'\left(\vec
r'(t',\vec y'),t'\right)\\
\vec r'(t_0,\vec y')=\vec y'.
\end{cases}
\end{equation}
On the other hand, by \er{hjgjgjgkllujkjjjkhhjh} for inverse
mappings we have
\begin{equation}\label{hjgjgjgkllujkjjjkhhjhljjk}
\vec f'(\vec x',t')=\vec f'\left(A(t)\cdot\vec x+\vec
z(t),t\right)=A(t_0)\cdot\vec f(\vec x,t)+\vec z(t_0).
\end{equation}

\subsubsection{Strain tensor and the Hooke's
law}\label{hjgujhghfhffgfg} Next, it is well known that the finite
strain tensor $\mathcal{E}(\vec x,t)\in\mathbb{R}^{3\times 3}$ of an
elastic continuum medium in the cartesian Eulerian coordinates
$(\vec x,t)$ has the form:
\begin{equation}\label{hjgjgjgkllujkjjjkhhjhljjkjkljjljj}
\mathcal{E}(\vec x,t):=\frac{1}{2}\left(I-\left\{d_{\vec x}\vec
f(\vec x,t)\right\}^T\cdot d_{\vec x}\vec f(\vec x,t)\right),
\end{equation}
where the mapping $\vec f(\vec x,t)$ is given by
\er{hjgghhghhffhkkihiyjhjkhjhjh} and satisfies
\er{hjgghhghhffhkkmkljjnnnn}. Then by \er{jjjjkkkk} and
\er{hjgjgjgkllujkjjjkhhjhljjk} we deduce that under \er{jjjjkkkk}
the matrix valued field $\mathcal{E}$ transforms as:
\begin{equation}\label{hjgjgjgkllujkjjjkhhjhljjkjkljjljjlljjkhhj}
\mathcal{E}'(\vec x',t)=A(t)\cdot\mathcal{E}(\vec x,t)\cdot A^T(t),
\end{equation}
i.e. $\mathcal{E}$ is a proper matrix field as it was defined in
Definition \ref{bggghghgj}.
\begin{remark}\label{uggghfhfhfg}
It is quite clear that the finite strain tensor, defined by
\er{hjgjgjgkllujkjjjkhhjhljjkjkljjljj} depends essentially on the
choice of initial instant of time $t_0$. Thus, the initial time
$t_0$ for \er{hjgjgjgkllujkjjjkhhjhljjkjkljjljj} is always chosen
(possibly fictitiously) in such a way that
our elastic body is relaxed at time $t_0$.
\end{remark}
Next, in the case of the simplest elastic body we have the following
Hooke's law that is similar to
\er{MaxVacFull1ninshtrgravortghhghgjkgghklhjgkghghjjkjhjkkggjkhjkhjjhhfhjhkhjjkhjgzzzyyykkknnnNWBWHWPPNjjhhjhjkpkl;kl;}:
\begin{equation}\label{MaxVacFull1ninshtrgravortghhghgjkgghklhjgkghghjjkjhjkkggjkhjkhjjhhfhjhkhjjkhjgzzzyyykkknnnNWBWHWPPNjjhhjhjkphkhkhhklklkl}
\mathcal{T}(\vec x,t)=\left(\alpha\mathcal{E}(\vec
x,t)+\beta\left(tr\left\{\mathcal{E}(\vec
x,t)\right\}\right)I\right),
\end{equation}
where $\alpha$ and $\beta$ are some material coefficients, which are
not necessary constant. Here $\mathcal{T}(\vec x,t)$ is the Cauchy
stress tensor appearing in the equations of the motion of the medium
\er{MaxVacFull1ninshtrgravortghhghgjkgghklhjgkghghjjkjhjkkggjkhjkhjjhhfhjhkjkhbbgjhzzzyyykkknnnNWBWHWPPNjj}
in the Eulerian coordinates $(\vec x,t)$ and $\mathcal{E}(\vec x,t)$
is the finite strain tensor defined by
\er{hjgjgjgkllujkjjjkhhjhljjkjkljjljj}. Then by
\er{hjgjgjgkllujkjjjkhhjhljjkjkljjljjlljjkhhj} we easily deduce that
the Hooke's law
\er{MaxVacFull1ninshtrgravortghhghgjkgghklhjgkghghjjkjhjkkggjkhjkhjjhhfhjhkhjjkhjgzzzyyykkknnnNWBWHWPPNjjhhjhjkphkhkhhklklkl}
is invariant under the change of a non-inertial cartesian coordinate
system given by \er{jjjjkkkk}, provided that, as usual, under
\er{jjjjkkkk} we have:
\begin{equation}\label{hjgjgjgkllujkjkhkkkjkl}
\begin{cases}
\alpha'=\alpha,\\
\beta'=\beta,\\
\mathcal{T}'=A(t)\cdot \mathcal{T}\cdot A^T(t).
\end{cases}
\end{equation}

Finally, in the case of the anisotropic elastic body we have the
following generalization of the Hooke's law
\er{MaxVacFull1ninshtrgravortghhghgjkgghklhjgkghghjjkjhjkkggjkhjkhjjhhfhjhkhjjkhjgzzzyyykkknnnNWBWHWPPNjjhhjhjkphkhkhhklklkl}:
\begin{equation}\label{MaxVacFull1ninshtrgravortghhghgjkgghklhjgkghghjjkjhjkkggjkhjkhjjhhfhjhkhjjkhjgzzzyyykkknnnNWBWHWPPNjjhhjhjkphkhkhhklklklaa}
\mathcal{T}(\vec x,t)=\Lambda(\vec x,t)\cdot\mathcal{E}(\vec x,t),
\end{equation}
where $\Lambda:=\Lambda(\vec x,t)$ is a linear operator ($9\times 9$
matrix) from $\mathbb{R}^{3\times 3}$ to $\mathbb{R}^{3\times 3}$,
such that
$$(\Lambda\cdot G)^T=\Lambda\cdot G\quad\quad\forall \,G\in \mathbb{R}^{3\times 3}\quad\text{such that}\quad G^T=G.$$
Then, as before, using Lemma \ref{khhjgjgggh}, by
\er{hjgjgjgkllujkjjjkhhjhljjkjkljjljjlljjkhhj} we easily deduce that
the law in
\er{MaxVacFull1ninshtrgravortghhghgjkgghklhjgkghghjjkjhjkkggjkhjkhjjhhfhjhkhjjkhjgzzzyyykkknnnNWBWHWPPNjjhhjhjkphkhkhhklklklaa}
is invariant under the change of a non-inertial cartesian coordinate
system given by \er{jjjjkkkk}, provided that, under \er{jjjjkkkk} we
have:
\begin{equation}\label{hjgjgjgkllujkjkhkkkjklaa}
\begin{cases}
\mathcal{T}'=A(t)\cdot \mathcal{T}\cdot A^T(t),\\
\Lambda'=\left(A(t)\otimes A(t)\right)\cdot\Lambda\cdot
\left(A^T(t)\otimes A^T(t)\right),
\end{cases}
\end{equation}
where the sign $\otimes$ in \er{hjgjgjgkllujkjkhkkkjklaa} means the
tensor product of the matrices, i.e. for given two linear operators
(matrices) $A\in\mathbb{R}^{3\times 3}$ and $B\in
\mathbb{R}^{3\times 3}$
their tensor product
$A\otimes B$ is a linear operator from $\mathbb{R}^{3\times 3}$ to
$\mathbb{R}^{3\times 3}$, defined by the identity:
\begin{equation}\label{fyfgjguyiuiHHHHHHkjjjjlzzaa}
(A\otimes B)\cdot(\vec a\otimes \vec b)=(A\cdot \vec a)\otimes
(B\cdot \vec b)\quad\quad\forall \vec a\in\mathbb{R}^{3},\;\forall
\vec b\in\mathbb{R}^{3}.
\end{equation}

\subsubsection{The equations of the motion of the general continuum medium in Lagrangian coordinates}
The equation of the motion of the continuum in the cartesian
Eulerian coordinates $(\vec x,t)$ has the form
\er{MaxVacFull1ninshtrgravortghhghgjkgghklhjgkghghjjkjhjkkggjkhjkhjjhhfhjhkjkhbbgjhzzzyyykkknnnNWBWHWPPNjj},
i.e:
\begin{multline}\label{MaxVacFull1ninshtrgravortghhghgjkgghklhjgkghghjjkjhjkkggjkhjkhjjhhfhjhkjkhbbgjhzzzyyykkknnnNWBWHWPPNjjjljkjk}
\mu\left(\frac{\partial\vec u}{\partial t}+d_{\vec x}\vec u\cdot
\vec u\right)=-\mu\vec u\times curl_{\vec x}\vec
v+\mu\left(\partial_{t}\vec
v+\nabla_{\vec x}\frac{1}{2}|\vec v|^2\right)+\rho\vec E+\frac{1}{c}\vec j\times\vec B+\Div_{\vec x}\mathcal{T}=\\
=-\mu(\vec u-\vec v)\times curl_{\vec x}\vec v
+\mu\left(\partial_t\vec v+ d_{\vec x}\vec v\cdot\vec
v\right)+\rho\vec E+\frac{1}{c}\vec j\times\vec B+\Div_{\vec
x}\mathcal{T},
\end{multline}
and the continuum equation has the form
\er{MaxVacFull1ninshtrgravortghhghgjkgghklhjgkghghjjkjhjkkggjkhjkhjjhhfhjhkhjjkhjgzzzyyykkknnnNWBWHWPPNjj},
i.e:
\begin{equation}\label{MaxVacFull1ninshtrgravortghhghgjkgghklhjgkghghjjkjhjkkggjkhjkhjjhhfhjhkhjjkhjgzzzyyykkknnnNWBWHWPPNjjjkk}
\frac{\partial\mu}{\partial t}+div_{\vec x}\left(\mu\vec u\right)=0.
\end{equation}
We would like to get the analogues of these equations in the
Lagrangian coordinates $(\vec y,t)$. First of all observe that since
the inverse change of coordinates is given by the formula:
\begin{equation}\label{hjhjjttuttutulj}
\begin{cases}
t=t\\
\vec x=\vec r(t,\vec y),
\end{cases}
\end{equation}
then clearly by the formula of the Change of Variable of the
Integration we deduce:
\begin{equation}\label{hjhjjttuttutukjhkhlk}
\mu(\vec y,t_0)=\mu\left(\vec r(t,\vec y),t\right)\left|det\{d_{\vec
y}\vec r(t,\vec y)\}\right|,
\end{equation}
i.e:
\begin{equation}\label{hjhjjttuttutukjhkhlkkjkh}
\mu\left(\vec r(t,\vec y),t\right)=\mu(\vec y,t_0)\left|det\{d_{\vec
y}\vec r(t,\vec y)\}\right|^{-1}\quad\quad\forall t,\;\forall\vec
y\in\Omega.
\end{equation}
In particular, by \er{hjgghhghhffhkkihiyjhjkhjhjh},
\er{hjhjjttuttutukjhkhlkkjkh} in Eulerian coordinates reeds as
\begin{equation}\label{hjhjjttuttutukjhkhlkkjkhjljjzz}
\mu\left(\vec x,t\right)=\mu(\vec x,t_0)\left|det\{d_{\vec x}\vec
f(\vec x,t)\}\right|.
\end{equation}
On the other hand, consistently with
\er{hjgghhghhffhkkmkljjnnnnoujkujjjjljj}, the analog of continuum
equation
\er{MaxVacFull1ninshtrgravortghhghgjkgghklhjgkghghjjkjhjkkggjkhjkhjjhhfhjhkhjjkhjgzzzyyykkknnnNWBWHWPPNjjjkk}
in Lagrangian coordinates $(\vec y,t)$ is the following
\begin{equation}\label{MaxVacFull1ninshtrgravortghhghgjkgghklhjgkghghjjkjhjkkggjkhjkhjjhhfhjhkhjjkhjgzzzyyykkknnnNWBWHWPPNjjjkkkyiyyyy}
\frac{\partial}{\partial t}\left(\mu\left(\vec r(t,\vec
y),t\right)\right)+\left(div_{\vec x}\vec u\left(\vec r(t,\vec
y),t\right)\right)\left(\mu\left(\vec r(t,\vec y),t\right)\right)=0.
\end{equation}
However, we can easily deduce that equality
\er{hjgghhghhffhkkihiyjhkpkpkbb} together with
\er{hjhjjttuttutukjhkhlkkjkh} implies
\er{MaxVacFull1ninshtrgravortghhghgjkgghklhjgkghghjjkjhjkkggjkhjkhjjhhfhjhkhjjkhjgzzzyyykkknnnNWBWHWPPNjjjkkkyiyyyy}.
Thus equality \er{hjhjjttuttutukjhkhlkkjkh} indeed substitutes the
continuum equation
\er{MaxVacFull1ninshtrgravortghhghgjkgghklhjgkghghjjkjhjkkggjkhjkhjjhhfhjhkhjjkhjgzzzyyykkknnnNWBWHWPPNjjjkk}
in the Lagrangian coordinates $(\vec y,t)$.

Next, the following Calculus fact is well known in Continuum
Mechanics: if, given some matrix valued field
\begin{equation}\label{hkhkhhhnnmhhhjklkl}
\mathcal{T}(\vec x,t)\in\mathbb{R}^{3\times 3},
\end{equation}
we denote
\begin{equation}\label{Mgjgjghghklkl}
F(\vec y,t):=d_{\vec y}\vec r(t,\vec y)\in\mathbb{R}^{3\times 3},
\end{equation}
and
\begin{equation}\label{Mgjgjghghkjljlljljjjkjhhhjh}
\mathcal{R}(\vec y,t):=\mathcal{T}\left(\vec r(t,\vec
y),t\right)\cdot \{F^{-1}(\vec y,t)\}^T\left(det\,F(\vec
y,t)\right),
\end{equation}
then we must have:
\begin{equation}\label{hjkgugfhfhffjkjkojljhjjgjjjggj}
\Div_{\vec y}\mathcal{R}(\vec y,t)=\left( \Div_{\vec
x}\mathcal{T}\left(\vec r(t,\vec y),t\right)\right)\left(det\,F(\vec
y,t)\right).
\end{equation}
In particular, if $\mathcal{T}(\vec x,t)$ is the Cauchy stress
tensor, then $\mathcal{R}(\vec y,t)$ defined by
\er{Mgjgjghghkjljlljljjjkjhhhjh} is called the first
Piola--Kirchhoff stress tensor. Then by
\er{hjkgugfhfhffjkjkojljhjjgjjjggj}, \er{hjhjjttuttutukjhkhlkkjkh},
\er{hjgghhghhffhkk} and
\er{hjgghhghhffhkkmkljjnnnnoujkujjjjljjkhhjkjk} we finally rewrite
\er{MaxVacFull1ninshtrgravortghhghgjkgghklhjgkghghjjkjhjkkggjkhjkhjjhhfhjhkjkhbbgjhzzzyyykkknnnNWBWHWPPNjjjljkjk}
in Lagrangian coordinates as:
\begin{multline}\label{MaxVacFull1ninshtrgravortghhghgjkgghklhjhgjggjgkghghjjkjhjkkggjkhjkhjjhhfhjhkjkhbbgjhzzzyyykkknnnNWBWHWPPNjjjljkjk}
\mu(\vec y,t_0)\frac{\partial^2\vec r}{\partial t^2}(t,\vec
y)=\Div_{\vec y}\mathcal{R}(\vec y,t)\\-\mu(\vec
y,t_0)\frac{\partial\vec r}{\partial t}(t,\vec y)\times curl_{\vec
x}\vec v\left(\vec r(t,\vec y),t\right)+\mu(\vec
y,t_0)\left(\partial_{t}\vec v\left(\vec r(t,\vec
y),t\right)+\nabla_{\vec x}\frac{1}{2}\left|\vec v\left(\vec
r(t,\vec y),t\right)\right|^2\right)\\+\frac{\mu(\vec
y,t_0)}{\mu\left(\vec r(t,\vec y),t\right)}\left(\rho\left(\vec
r(t,\vec y),t\right)\vec E\left(\vec r(t,\vec
y),t\right)+\frac{1}{c}\vec j\left(\vec r(t,\vec
y),t\right)\times\vec B\left(\vec r(t,\vec y),t\right)\right).
\end{multline}

\subsection{Propagation of sound wave in the moving inviscid gas or fluid}\label{jihggghjgjgg}
Again, consistently with
\er{MaxVacFull1ninshtrgravortghhghgjkgghklhjgkghghjjkjhjkkggjkhjkhjjhhfhjhkjkhbbgjhzzzyyykkknnnNWBWHWPPNjj},
consider in some cartesian coordinate system $(*)$ the second Law of
Newton for the moving continuum medium with the inertial mass
density $\mu$, the field of average (macroscopic) velocities $\vec
u$, the charge density $\rho$ and the electric current density $\vec
j$:
\begin{multline}\label{MaxVacFull1ninshtrgravortghhghgjkgghklhjgkghghjjkjhjkkggjkhjkhjjhhfhjhkjkhbbgjhzzzyyykkknnnNWBWHWPPNjj1}
\mu\left(\frac{\partial\vec u}{\partial t}+d_{\vec x}\vec u\cdot
\vec u\right)=\frac{\partial}{\partial t}(\mu\vec u)+\Div_{\vec
x}\left\{\mu\vec u\otimes\vec u\right\}=\\-\mu\vec u\times
curl_{\vec x}\vec v+\mu\left(\partial_{t}\vec
v+\nabla_{\vec x}\frac{1}{2}|\vec v|^2\right)+\rho\vec E+\frac{1}{c}\vec j\times\vec B+\Div_{\vec x}\mathcal{T}=\\
=-\mu(\vec u-\vec v)\times curl_{\vec x}\vec v
+\mu\left(\partial_t\vec v+ d_{\vec x}\vec v\cdot\vec
v\right)+\rho\vec E+\frac{1}{c}\vec j\times\vec B+\Div_{\vec
x}\mathcal{T}.
\end{multline}
Here $\rho\vec E+\frac{1}{c}\vec j\times\vec B$ is the volume
density of the Lorentz force where $\vec E$ and $\vec B$ are outer
electric and magnetic fields, assumed to be changing smoothly and
almost constant in the microscopic level, $\vec v$ is a vectorial
gravitational potential also assumed to be changing smoothly and
almost constant in the microscopic level, and $\mathcal{T}\in
\R^{3\times 3}$ is the symmetric Cauchy stress tensor of the
continuum medium. Moreover, the mass density $\mu$, clearly
satisfies the continuum equation
\er{MaxVacFull1ninshtrgravortghhghgjkgghklhjgkghghjjkjhjkkggjkhjkhjjhhfhjhkhjjkhjgzzzyyykkknnnNWBWHWPPNjj}:
\begin{equation}\label{MaxVacFull1ninshtrgravortghhghgjkgghklhjgkghghjjkjhjkkggjkhjkhjjhhfhjhkhjjkhjgzzzyyykkknnnNWBWHWPPNjj1}
\frac{\partial\mu}{\partial t}+div_{\vec x}\left(\mu\vec u\right)=0.
\end{equation}
Next, by
\er{MaxVacFull1ninshtrgravortghhghgjkgghklhjgkghghjjkjhjkkggjkhjkhjjhhfhjhkjkhbbgjhzzzyyykkknnnNWBWHWPPNjjhjhjhjh}
the First Law of Thermodynamics of this moving medium has the
following form:
\begin{multline}\label{MaxVacFull1ninshtrgravortghhghgjkgghklhjgkghghjjkjhjkkggjkhjkhjjhhfhjhkjkhbbgjhzzzyyykkknnnNWBWHWPPNjjhjhjhjh1}
\frac{\partial E}{\partial t}+\Div_{\vec x}\left\{E\vec
u\right\}=\mu\left(\frac{\partial}{\partial
t}\left(\frac{E}{\mu}\right)+\vec u\cdot\nabla_{\vec
x}\left(\frac{E}{\mu}\right)\right)\\ =\frac{1}{2}\left(d_{\vec
x}\vec u+\left\{d_{\vec x}\vec
u\right\}^T\right):\mathcal{T}-\Div_{\vec x}\vec q+\left(\vec
j-\rho\vec u\right)\cdot\left(\vec E+\frac{1}{c}\vec u\times\vec
B\right).
\end{multline}
Here $E$ is the volume density of the internal energy (energy per
unit volume) and consistently $\frac{E}{\mu}$ is the internal energy
per unit mass and $\vec q$ is the heat flux. Next, in the case of
inviscid fluid or gas
\er{MaxVacFull1ninshtrgravortghhghgjkgghklhjgkghghjjkjhjkkggjkhjkhjjhhfhjhkhjjkhjgzzzyyykkknnnNWBWHWPPNjjhhjhj}
holds:
\begin{equation}\label{MaxVacFull1ninshtrgra,mnvortghhghgjkgghklhjgkghghjjkjhjkkggjkhjkhjjhhfhjhkjkhbbgjhzzzyyykkknnnNWBWHWPPNjjhjhjhjhplllhjggjl}
\mathcal{T}=-p I.
\end{equation}
Thus, as before, we rewrite,
\er{MaxVacFull1ninshtrgravortghhghgjkgghklhjgkghghjjkjhjkkggjkhjkhjjhhfhjhkjkhbbgjhzzzyyykkknnnNWBWHWPPNjj1}
as:
\begin{multline}\label{MaxVacFull1ninshtrgravortghhghgjkgghklhjgkghghjjkjhjkkggjkhjkhjjhhfhjhkjkhbbgjhzzzyyykkknnnNWBWHWPPNjjhuhh}
\mu\left(\frac{\partial\vec u}{\partial t}+d_{\vec x}\vec u\cdot
\vec u\right)=\frac{\partial}{\partial t}(\mu\vec u)+\Div_{\vec
x}\left\{\mu\vec u\otimes\vec u\right\}=\\-\mu\vec u\times
curl_{\vec x}\vec v+\mu\left(\partial_{t}\vec
v+\nabla_{\vec x}\frac{1}{2}|\vec v|^2\right)+\rho\vec E+\frac{1}{c}\vec j\times\vec B-\nabla_{\vec x}p=\\
=-\mu(\vec u-\vec v)\times curl_{\vec x}\vec v
+\mu\left(\partial_t\vec v+ d_{\vec x}\vec v\cdot\vec
v\right)+\rho\vec E+\frac{1}{c}\vec j\times\vec B-\nabla_{\vec x}p,
\end{multline}
and we rewrite
\er{MaxVacFull1ninshtrgravortghhghgjkgghklhjgkghghjjkjhjkkggjkhjkhjjhhfhjhkjkhbbgjhzzzyyykkknnnNWBWHWPPNjjhjhjhjh1}
as:
\begin{multline}\label{MaxVacFull1ninshtrgravortghhghgjkgghklhjgkghghjjkjhjkkggjkhjkhjjhhfhjhkjkhbbgjhzzzyyykkknnnNWBWHWPPNjjhjhjhjhugyyy}
\frac{\partial E}{\partial t}+\Div_{\vec x}\left\{E\vec
u\right\}=\mu\left(\frac{\partial}{\partial
t}\left(\frac{E}{\mu}\right)+\vec u\cdot\nabla_{\vec
x}\left(\frac{E}{\mu}\right)\right)\\ =-\left(\Div_{\vec x}\vec
u\right)p-\Div_{\vec x}\vec q+\left(\vec j-\rho\vec
u\right)\cdot\left(\vec E+\frac{1}{c}\vec u\times\vec B\right).
\end{multline}
Moreover, the internal energy per unit mass is a function of the
density $\mu$, pressure $p$ and Kelvin temperature $T$:
\begin{equation}\label{MaxVacFull1ninshtrgravortghhghgjkgghklhjgkghghjjkjhjkkggjkhjkhjjhhfhjhkjkhbbgjhzzzyyykkknnnNWBWHWPPNjjhjhjhjhplllhjggjlh}
\frac{E}{\mu}=U\left(\mu,p,T\right).
\end{equation}
There is also a state equation of the form:
\begin{equation}\label{MaxVacFull1ninsmljjlhtrgravortghhghgjkgghklhjgkghghjjkjhjkkggjkhjkhjjhhfhjhkjkhbbgjhzzzyyykkknnnNWBWHWPPNjjhjhjhjhplllhjggjlh}
T=g\left(\mu,p\right).
\end{equation}
We remind that in the case of the simplest ideal gas
\er{MaxVacFull1ninshtrgravortghhghgjkgghklhjgkghghjjkjhjkkggjkhjkhjjhhfhjhkjkhbbgjhzzzyyykkknnnNWBWHWPPNjjhjhjhjhplllhjggjlh}
takes particular form of \er{noninchredPPNbjjhjjkhkhkljjljljkl}:
\begin{equation}\label{noninchredPPNbjjhjjkhkhkljjljljklhjhjhjghhgghjhhj}
\frac{E}{\mu}\,=\,\frac{\gamma_0}{m_0}\,\,k\,T,
\end{equation}
and
\er{MaxVacFull1ninsmljjlhtrgravortghhghgjkgghklhjgkghghjjkjhjkkggjkhjkhjjhhfhjhkjkhbbgjhzzzyyykkknnnNWBWHWPPNjjhjhjhjhplllhjggjlh}
takes particular form of \er{noninchredPPNbjjhjjkhkhlkkkkl}:
\begin{equation}\label{noninchredPPNbjjhjjkhkhlkkkklhjgghgh}
T\,=\,\frac{m_0}{k}\,\frac{p}{\mu},
\end{equation}
where $m_0$ is the mass of the single molecule of the given gas and
$k$ is the Boltzmann constant and $\gamma_0>0$ is a constant that
depends on the kind of the gas (for the monatomic gas we have
$\gamma_0=\frac{3}{2}$). So, by inserting
\er{MaxVacFull1ninsmljjlhtrgravortghhghgjkgghklhjgkghghjjkjhjkkggjkhjkhjjhhfhjhkjkhbbgjhzzzyyykkknnnNWBWHWPPNjjhjhjhjhplllhjggjlh}
into
\er{MaxVacFull1ninshtrgravortghhghgjkgghklhjgkghghjjkjhjkkggjkhjkhjjhhfhjhkjkhbbgjhzzzyyykkknnnNWBWHWPPNjjhjhjhjhplllhjggjlh}
we have:
\begin{equation}\label{MaxVacFull1nkkkinshtrgravortghhghgjkgghklhjgkghghjjkjhjkkggjkhjkhjjhhfhjhkjkhbbgjhzzzyyykkknnhjhhnNWBWHWPPNjjhjhjhjhplllhjggjll;;lijiih}
\frac{E}{\mu}
=U\left(\mu,p,g\left(\mu,p\right)\right):=F\left(\mu,p\right).
\end{equation}
In particular, in the case where
\er{noninchredPPNbjjhjjkhkhkljjljljklhjhjhjghhgghjhhj} and
\er{noninchredPPNbjjhjjkhkhlkkkklhjgghgh} are valid we have
\begin{equation}\label{MaxVacFull1nkkkinshtrgravortghhghgjkgghklhjgkghghjjkjhjkkggjkhjkhjjhhfhjhkjkhbbgjhzzzyyykkknnhjhhnNWBWHWPPNjjhjhjhjhplllhjggjll;;lijiihkhhkgjgj}
\frac{E}{\mu}=F\left(\mu,p\right):=\gamma_0\,\frac{p}{\mu}.
\end{equation}

Next we assume that
\begin{equation}\label{MaxVacFull1ninshtrgravortghhghgjkgghklhjgkghghjjkjhjkkggjkhjkhjjhhfhjhkjkhbbgjhzzzyyykkknnnNWBWHWPPNjjhjhjhjhplllhjggjl}
\vec u=\vec u_0+\vec u_1,\quad\mu=\mu_0+\mu_1,\quad p=p_0+p_1.
\end{equation}
where $\vec u_0(\vec x,t),\mu_0(\vec x,t),p_0(\vec x,t)$ are the
averages of $\vec u(\vec x,t),\mu(\vec x,t),p(\vec x,t)$ on small
spatial and temporal intervals, surrounding the point $(\vec x,t)$.
Although we assume these intervals of space and time to be very
small, we also assume them to be quite \underline{macroscopic}. We
call $\vec u_1,\mu_1,p_1$ the oscillating parts of $\vec u,\mu,p$
and we assume that they are small with respect to the averages $\vec
u_0,\mu_0,p_0$ i.e. we have:
\begin{equation}\label{MaxVacFull1yiyiyninshtrgravortghhghgjkgghklhjgkghghjjkjhjkkggjkhjkhjjhhfhjhkjkhbbgjhzzzyyykkknnnNWBWHWPPNjjhjhjhjhplllhjggjl;k}
|\vec u_1|\ll|\vec u_0|,\quad|\mu_1|\ll |\mu_0|,\quad |p_1|\ll
|p_0|.
\end{equation}
However, we assume that $\vec u_1,\mu_1,p_1$ are highly oscillate
and thus they changes spatially and temporary much faster than the
averages  $\vec u_0,\mu_0,p_0$ and the fields $\vec v, \vec E,\vec
B,\rho,\vec j$ i.e. we have:
\begin{multline}\label{MaxVacFull1yiyiyninshtrgravortghhghgjkgghklhjghjhnkkghghjjkjhjkkggjkhjkhjjhhfhjhkjkhbbgjhzzzyyykkknnnNWBWHWPPNjjhjhjhjhplllhjggjl;k}
\frac{\left|d_{\vec x}\left(\rho\vec E+\frac{1}{c}\vec j\times\vec
B\right)\right|}{\left|\rho\vec E+\frac{1}{c}\vec j\times\vec
B\right|}+\frac{|d_{\vec x}\vec v|}{|\vec v|}+\frac{|d_{\vec
x}\mu_0|}{|\mu_0|}+\frac{|d_{\vec x}p_0|}{|p_0|}+\frac{|d_{\vec
x}\vec u_0|}{|\vec
u_0|}\ll \min{\left\{\frac{|d_{\vec x}\vec u_1|}{|\vec u_1|},\frac{|d_{\vec x}\mu_1|}{|\mu_1|},\frac{|d_{\vec x}p_1|}{|p_1|}\right\}},\\
\quad\quad
\frac{\left|\partial_t\left(\rho\vec E+\frac{1}{c}\vec j\times\vec
B\right)\right|}{\left|\rho\vec E+\frac{1}{c}\vec j\times\vec
B\right|}+\frac{|\partial_t\vec v|}{|\vec
v|}+\frac{|\partial_t\mu_0|}{|\mu_0|}+\frac{|\partial_t
p_0|}{|p_0|}+\frac{|\partial_t\vec u_0|}{|\vec u_0|}\ll
\min{\left\{\frac{|\partial_t\vec u_1|}{|\vec
u_1|},\frac{|\partial_t\mu_1|}{|\mu_1|},\frac{|\partial_t
p_1|}{|p_1|}\right\}},\\ \frac{|\vec u_1|}{|\vec u_0|}\ll
\min{\left\{\frac{|d_{\vec x}\vec u_1|}{|d_{\vec x}\vec
u_0|},\frac{|d_{\vec x}\vec u_1|}{|d_{\vec x}\vec v|}\right\}},
\quad\quad \frac{|\mu_1|}{|\mu_0|}+\frac{|p_1|}{|p_0|}\ll
\min{\left\{\frac{|d_{\vec x}p_1|}{|d_{\vec x}p_0|},\frac{|d_{\vec
x}\mu_1|}{|d_{\vec x}\mu_0|}\right\}}\\ \quad\quad\text{and}
\quad\quad \frac{|\mu_1|}{|\mu_0|}\ll\frac{|d_{\vec x}\vec u_1||\vec
u_0||\mu_0|}{\left|\rho\vec E+\frac{1}{c}\vec j\times\vec B\right|}.
\end{multline}
Finally, we assume that the fields $\vec v, \vec E,\vec B,\rho,\vec
j$ and $(\Div_{\vec x}\vec q)$ change slowly with respect to the
oscillations of  $\vec u_1,\mu_1,p_1$ and thus we assume that $\vec
v, \vec E,\vec B,\rho,\vec j$ and $(\Div_{\vec x}\vec q)$ can be
replaced by their spatial and temporal averages. Note that
$\mu,p,\mu_0,p_0,\mu_1,p_1$ behave like \underline{proper scalar}
fields and $\vec u,\vec u_0$ behave like \underline{speed-like
vector} fields under the change of cartesian coordinate systems.
Thus, since $\vec u_1=\vec u-\vec u_0$, we deduce that $\vec u_1$
behaves like a \underline{proper vector} field  under the change of
cartesian coordinate systems. Finally, note that obviously the
averages of $\vec u_1,\mu_1,p_1$ vanish.

Next we would like to approximate
\er{MaxVacFull1ninshtrgravortghhghgjkgghklhjgkghghjjkjhjkkggjkhjkhjjhhfhjhkjkhbbgjhzzzyyykkknnnNWBWHWPPNjjhuhh}.
In the rough level of approximation we could just replace $\vec
u,\mu,p$ by  $\vec u_0,\mu_0,p_0$. However, we would like to use a
more delicate approximation, taking into the account the first order
terms with $\vec u_1,\mu_1,p_1$. Then, by inserting
\er{MaxVacFull1ninshtrgravortghhghgjkgghklhjgkghghjjkjhjkkggjkhjkhjjhhfhjhkjkhbbgjhzzzyyykkknnnNWBWHWPPNjjhjhjhjhplllhjggjl}
into
\er{MaxVacFull1ninshtrgravortghhghgjkgghklhjgkghghjjkjhjkkggjkhjkhjjhhfhjhkjkhbbgjhzzzyyykkknnnNWBWHWPPNjjhuhh}
and using
\er{MaxVacFull1yiyiyninshtrgravortghhghgjkgghklhjgkghghjjkjhjkkggjkhjkhjjhhfhjhkjkhbbgjhzzzyyykkknnnNWBWHWPPNjjhjhjhjhplllhjggjl;k}
we conclude:
\begin{multline}\label{MaxVacFull1ninshtrgravortghhghgjkgghklhjgkghghjjkjhjkkggjkhjkhjjhhfhjhkjhjjhkhbbgjlmkmmhzzzyyykkknnnNWBWHWPPNjjlllkkkkkklbbbnjh}
\mu\left(\frac{\partial\vec u}{\partial t}+d_{\vec x}\vec u\cdot
\vec u\right)\approx\\ \mu_0\left(\frac{\partial\vec u_1}{\partial
t}+d_{\vec x}\vec u_0\cdot \vec u_1+d_{\vec x}\vec u_1\cdot \vec
u_0\right)+\mu_1\left(\frac{\partial\vec u_0}{\partial t}+d_{\vec
x}\vec u_0\cdot \vec u_0\right)+\mu_0\left(\frac{\partial\vec
u_0}{\partial t}+d_{\vec x}\vec u_0\cdot \vec u_0\right)
\approx\\-\mu_0\vec u_0\times curl_{\vec x}\vec
v+\mu_0\left(\partial_{t}\vec v+\nabla_{\vec x}\frac{1}{2}|\vec
v|^2\right)+\rho\vec E+\frac{1}{c}\vec j\times\vec B-\nabla_{\vec x}
p_0\\ +\mu_1\left(\partial_{t}\vec v+\nabla_{\vec x}\frac{1}{2}|\vec
v|^2\right) -\mu_1\vec u_0\times curl_{\vec x}\vec v-\mu_0\vec
u_1\times curl_{\vec x}\vec v-\nabla_{\vec x} p_1.
\end{multline}
On the other hand by inserting
\er{MaxVacFull1ninshtrgravortghhghgjkgghklhjgkghghjjkjhjkkggjkhjkhjjhhfhjhkjkhbbgjhzzzyyykkknnnNWBWHWPPNjjhjhjhjhplllhjggjl}
into
\er{MaxVacFull1ninshtrgravortghhghgjkgghklhjgkghghjjkjhjkkggjkhjkhjjhhfhjhkhjjkhjgzzzyyykkknnnNWBWHWPPNjj1}
and using
\er{MaxVacFull1yiyiyninshtrgravortghhghgjkgghklhjgkghghjjkjhjkkggjkhjkhjjhhfhjhkjkhbbgjhzzzyyykkknnnNWBWHWPPNjjhjhjhjhplllhjggjl;k}
we have
\begin{multline}\label{MaxVacFull1ninshtrgravortghhghgjkgghklhjgkghghjjkjhjkkggjkhjkhjjhhfhjhkhjjkhjgzzzyyykkknnnNWBWHWPPNjjlkjljjjkhjhjhjhjhjbh1}
0=\left(\frac{\partial\mu}{\partial t}+\vec
u\cdot\nabla_{\vec x}\mu\right)+\mu\,div_{\vec x}\vec u\approx\\
\left(\frac{\partial\mu_0}{\partial t}+\vec u_0\cdot\nabla_{\vec
x}\mu_0\right)+\left(\frac{\partial\mu_1}{\partial t}+\vec
u_0\cdot\nabla_{\vec x}\mu_1+\vec u_1\cdot\nabla_{\vec
x}\mu_0\right)+\mu_0\,div_{\vec x}\vec u_0+\mu_0\,div_{\vec x}\vec
u_1+\mu_1\,div_{\vec x}\vec u_0\approx 0.
\end{multline}
In particular, averaging of
\er{MaxVacFull1ninshtrgravortghhghgjkgghklhjgkghghjjkjhjkkggjkhjkhjjhhfhjhkjhjjhkhbbgjlmkmmhzzzyyykkknnnNWBWHWPPNjjlllkkkkkklbbbnjh}
gives:
\begin{multline}\label{MaxVacFull1ninshtrgravortghhghgjkgghklhjgkghghjjkjhjkkggjkhjkhjjhhfhjhkjhjjhkhbbgjlmkmmhzzzyyykkknnnNWBWHWPPNjjlllkkkkkklhjhjh}
\mu_0\left(\frac{\partial\vec u_0}{\partial t}+d_{\vec x}\vec
u_0\cdot \vec u_0\right) \approx-\mu_0\vec u_0\times curl_{\vec
x}\vec v+\mu_0\left(\partial_{t}\vec v+\nabla_{\vec
x}\frac{1}{2}|\vec v|^2\right)+\rho\vec E+\frac{1}{c}\vec
j\times\vec B-\nabla_{\vec x} p_0,
\end{multline}
and averaging of
\er{MaxVacFull1ninshtrgravortghhghgjkgghklhjgkghghjjkjhjkkggjkhjkhjjhhfhjhkhjjkhjgzzzyyykkknnnNWBWHWPPNjjlkjljjjkhjhjhjhjhjbh1}
gives
\begin{equation}\label{MaxVacFull1ninshtrgravortghhghgjkgghklhjgkghghjjkjhjkkggjkhjkhjjhhfhjhkhjjkhjgzzzyyykkknnnNWBWHWPPNjjlkjljjjkhjhjhjhjhjbh}
\left(\frac{\partial\mu_0}{\partial t}+\vec u_0\cdot\nabla_{\vec
x}\mu_0\right)+\mu_0\,div_{\vec x}\vec u_0\approx 0.
\end{equation}
Here we used the fact that that $\vec v, \vec E,\vec B,\rho,\vec j$
can be replaced by their spatial and temporal averages. Thus,
subtracting
\er{MaxVacFull1ninshtrgravortghhghgjkgghklhjgkghghjjkjhjkkggjkhjkhjjhhfhjhkjhjjhkhbbgjlmkmmhzzzyyykkknnnNWBWHWPPNjjlllkkkkkklhjhjh}
from the unaveraged
\er{MaxVacFull1ninshtrgravortghhghgjkgghklhjgkghghjjkjhjkkggjkhjkhjjhhfhjhkjhjjhkhbbgjlmkmmhzzzyyykkknnnNWBWHWPPNjjlllkkkkkklbbbnjh}
gives:
\begin{multline}\label{MaxVacFull1ninshtrgravortghhghgjkgghklhjgkghghjjkjhjkkggjkhjkhjjhhfhjhkjhjjhkhbbgjlmkmmhzzzyyykkknnnNWBWHWPPNjjlllkkkkkklbbbnjhkhkhhjjoujuijkjk}
\mu_0\left(\frac{\partial\vec u_1}{\partial t}+d_{\vec x}\vec
u_0\cdot \vec u_1+d_{\vec x}\vec u_1\cdot \vec
u_0\right)+\mu_1\left(\frac{\partial\vec u_0}{\partial t}+d_{\vec
x}\vec u_0\cdot \vec u_0\right) \approx\\
+\mu_1\left(\partial_{t}\vec v+\nabla_{\vec x}\frac{1}{2}|\vec
v|^2\right) -\mu_1\vec u_0\times curl_{\vec x}\vec v-\mu_0\vec
u_1\times curl_{\vec x}\vec v-\nabla_{\vec x} p_1,
\end{multline}
and thus, by
\er{MaxVacFull1ninshtrgravortghhghgjkgghklhjgkghghjjkjhjkkggjkhjkhjjhhfhjhkjhjjhkhbbgjlmkmmhzzzyyykkknnnNWBWHWPPNjjlllkkkkkklbbbnjhkhkhhjjoujuijkjk}
and
\er{MaxVacFull1ninshtrgravortghhghgjkgghklhjgkghghjjkjhjkkggjkhjkhjjhhfhjhkjhjjhkhbbgjlmkmmhzzzyyykkknnnNWBWHWPPNjjlllkkkkkklhjhjh}
we have:
\begin{multline}\label{MaxVacFull1ninshtrgravortghhghgjkgghklhjgkghghjjkjhjkkggjkhjkhjjhhfhjhkjhjjhkhbbgjlmkmmhzzzyyykkknnnNWBWHWPPNjjlllkkkkkklbbbnjhkhkhhj}
\mu_0\left(\frac{\partial\vec u_1}{\partial t}+d_{\vec x}\vec
u_0\cdot \vec u_1+d_{\vec x}\vec u_1\cdot \vec u_0\right) \approx\\
-\mu_0\vec u_1\times curl_{\vec x}\vec v-\nabla_{\vec x}
p_1+\frac{\mu_1}{\mu_0}\nabla_{\vec x} p_0
-\frac{\mu_1}{\mu_0}\left(\rho\vec E+\frac{1}{c}\vec j\times\vec
B\right).
\end{multline}
Furthermore, subtracting
\er{MaxVacFull1ninshtrgravortghhghgjkgghklhjgkghghjjkjhjkkggjkhjkhjjhhfhjhkhjjkhjgzzzyyykkknnnNWBWHWPPNjjlkjljjjkhjhjhjhjhjbh}
from the unaveraged
\er{MaxVacFull1ninshtrgravortghhghgjkgghklhjgkghghjjkjhjkkggjkhjkhjjhhfhjhkhjjkhjgzzzyyykkknnnNWBWHWPPNjjlkjljjjkhjhjhjhjhjbh1}
gives:
\begin{equation}\label{MaxVacFull1ninshtrgravortghhghgjkgghklhjgkghghjjkjhjkkggjkhjkhjjhhfhjhkhjjkhjgzzzyyykkknnnNWBWHWPPNjjlkjljjjkhjhjhjhjhjbhjihhhhhjh}
\left(\frac{\partial\mu_1}{\partial t}+\vec u_0\cdot\nabla_{\vec
x}\mu_1+\vec u_1\cdot\nabla_{\vec x}\mu_0\right)+\mu_0\,div_{\vec
x}\vec u_1+\mu_1\,div_{\vec x}\vec u_0\approx 0.
\end{equation}
Therefore, using
\er{MaxVacFull1yiyiyninshtrgravortghhghgjkgghklhjghjhnkkghghjjkjhjkkggjkhjkhjjhhfhjhkjkhbbgjhzzzyyykkknnnNWBWHWPPNjjhjhjhjhplllhjggjl;k}
we further approximate
\er{MaxVacFull1ninshtrgravortghhghgjkgghklhjgkghghjjkjhjkkggjkhjkhjjhhfhjhkjhjjhkhbbgjlmkmmhzzzyyykkknnnNWBWHWPPNjjlllkkkkkklbbbnjhkhkhhj}
as:
\begin{equation}\label{MaxVacFull1ninshtrgravortghhghgjkgghklhjgkghghjjkjhjkkggjkhjkhjjhhfhjhkjhjjhkhbbgjlmkmmhzzzyyykkknnnNWBWHWPPNjjlllkkkkkkljkkjjkjk11}
\mu_0\left(\frac{\partial\vec u_1}{\partial t}+d_{\vec x}\vec
u_1\cdot \vec u_0\right) \approx-\nabla_{\vec x} p_1.
\end{equation}
Thus, again using
\er{MaxVacFull1yiyiyninshtrgravortghhghgjkgghklhjghjhnkkghghjjkjhjkkggjkhjkhjjhhfhjhkjkhbbgjhzzzyyykkknnnNWBWHWPPNjjhjhjhjhplllhjggjl;k},
we approximate
\er{MaxVacFull1ninshtrgravortghhghgjkgghklhjgkghghjjkjhjkkggjkhjkhjjhhfhjhkjhjjhkhbbgjlmkmmhzzzyyykkknnnNWBWHWPPNjjlllkkkkkkljkkjjkjk11}
as:
\begin{equation}\label{MaxVacFull1ninshtrgravortghhghgjkgghklhjgkghghjjkjhjkkggjkhjkhjjhhfhjhkjhjjhkhbbgjlmkmmhzzzyyykkknnnNWBWHWPPNjjlllkkkkkkljkkjjkjkuhg}
\mu_0\left(\frac{\partial\vec u_1}{\partial t}+(\Div_{\vec x}\vec
u_0)\,\vec u_1+d_{\vec x}\vec u_1\cdot \vec u_0-d_{\vec x}\vec
u_0\cdot \vec u_1\right) \approx-\nabla_{\vec x} p_1.
\end{equation}
Then, by \er{apfrm6} we rewrite
\er{MaxVacFull1ninshtrgravortghhghgjkgghklhjgkghghjjkjhjkkggjkhjkhjjhhfhjhkjhjjhkhbbgjlmkmmhzzzyyykkknnnNWBWHWPPNjjlllkkkkkkljkkjjkjkuhg}
as:
\begin{equation}\label{MaxVacFull1ninshtrgravortghhghgjkgghklhjgkghghjjkjhjkkggjkhjkhjjhhfhjhkjhjjhkhbbgjlmkmmhzzzyyykkknnnNWBWHWPPNjjlllkkkkkkljkkjjkjk}
\mu_0\left(\frac{\partial\vec u_1}{\partial t}-curl_{\vec
x}\left\{\vec u_0\times\vec u_1\right\}+(\Div_{\vec x}\vec
u_1)\,\vec u_0\right) \approx-\nabla_{\vec x} p_1.
\end{equation}
Moreover, using
\er{MaxVacFull1yiyiyninshtrgravortghhghgjkgghklhjghjhnkkghghjjkjhjkkggjkhjkhjjhhfhjhkjkhbbgjhzzzyyykkknnnNWBWHWPPNjjhjhjhjhplllhjggjl;k}
we further approximate
\er{MaxVacFull1ninshtrgravortghhghgjkgghklhjgkghghjjkjhjkkggjkhjkhjjhhfhjhkhjjkhjgzzzyyykkknnnNWBWHWPPNjjlkjljjjkhjhjhjhjhjbhjihhhhhjh}
as:
\begin{equation}\label{MaxVacFull1ninshtrgravortghhghgjkgghklhjgkghghjjkjhjkkggjkhjkhjjhhfhjhkhjjkhjgzzzyyykkknnnNWBWHWPPNjjlkjljjjkijijji1}
\left(\frac{\partial\mu_1}{\partial t}+\vec u_0\cdot\nabla_{\vec
x}\mu_1\right)+\mu_0\,div_{\vec x}\vec u_1\approx 0,
\end{equation}
and thus,
\begin{equation}\label{MaxVacFull1ninshtrgravortghhghgjkgghklhjgkghghjjkjhjkkggjkhjkhjjhhfhjhkhjjkhjgzzzyyykkknnnNWBWHWPPNjjlkjljjjkijijji}
\frac{1}{\mu_0}\left(\frac{\partial\mu_1}{\partial t}+\vec
u_0\cdot\nabla_{\vec x}\mu_1\right)+div_{\vec x}\vec u_1\approx 0.
\end{equation}
In particular, taking the divergence of both sides of
\er{MaxVacFull1ninshtrgravortghhghgjkgghklhjgkghghjjkjhjkkggjkhjkhjjhhfhjhkjhjjhkhbbgjlmkmmhzzzyyykkknnnNWBWHWPPNjjlllkkkkkkljkkjjkjk}
and using again
\er{MaxVacFull1yiyiyninshtrgravortghhghgjkgghklhjghjhnkkghghjjkjhjkkggjkhjkhjjhhfhjhkjkhbbgjhzzzyyykkknnnNWBWHWPPNjjhjhjhjhplllhjggjl;k},
gives
\begin{equation}\label{MaxVacFull1ninshtrgravortghhghgjkgghklhjgkghghjjkjhjkkggjkhjkhjjhhfhjhkjhjjhkhbbgjlmkmmhzzzyyykkknnnNWBWHWPPNjjlllkkkkkkljkkjjkjkhjjhhggh}
\mu_0\left(\frac{\partial}{\partial t}\left(\Div_{\vec x}\vec
u_1\right)+\Div_{\vec x}\left\{\left(\Div_{\vec x}\vec u_1\right)
\vec u_0\right\}\right) \approx-\Delta_{\vec x} p_1.
\end{equation}
Therefore, inserting
\er{MaxVacFull1ninshtrgravortghhghgjkgghklhjgkghghjjkjhjkkggjkhjkhjjhhfhjhkhjjkhjgzzzyyykkknnnNWBWHWPPNjjlkjljjjkijijji}
into
\er{MaxVacFull1ninshtrgravortghhghgjkgghklhjgkghghjjkjhjkkggjkhjkhjjhhfhjhkjhjjhkhbbgjlmkmmhzzzyyykkknnnNWBWHWPPNjjlllkkkkkkljkkjjkjkhjjhhggh}
and using again
\er{MaxVacFull1yiyiyninshtrgravortghhghgjkgghklhjghjhnkkghghjjkjhjkkggjkhjkhjjhhfhjhkjkhbbgjhzzzyyykkknnnNWBWHWPPNjjhjhjhjhplllhjggjl;k},
we deduce
\begin{equation}\label{MaxVacFull1ninshtrgravortghhghgjkgghklhjgkghghjjkjhjkkggjkhjkhjjhhfhjhkjhjjhkhbbgjlmkmmhzzzyyykkknnnNWBWHWPPNjjlllkkkkkkljkkjjkjkhjjhhgghgggg}
\frac{\partial}{\partial t}\left(\frac{\partial\mu_1}{\partial
t}+\vec u_0\cdot\nabla_{\vec x}\mu_1\right)+\Div_{\vec
x}\left\{\left(\frac{\partial\mu_1}{\partial t}+\vec
u_0\cdot\nabla_{\vec x}\mu_1\right) \vec u_0\right\}
\approx\Delta_{\vec x} p_1.
\end{equation}
Note that, as before, it can be easily proved that
\er{MaxVacFull1ninshtrgravortghhghgjkgghklhjgkghghjjkjhjkkggjkhjkhjjhhfhjhkjhjjhkhbbgjlmkmmhzzzyyykkknnnNWBWHWPPNjjlllkkkkkkljkkjjkjkhjjhhgghgggg}
and
\er{MaxVacFull1ninshtrgravortghhghgjkgghklhjgkghghjjkjhjkkggjkhjkhjjhhfhjhkjhjjhkhbbgjlmkmmhzzzyyykkknnnNWBWHWPPNjjlllkkkkkkljkkjjkjk}
are invariant under the change of inertial or non-inertial cartesian
coordinate system. Thus,
\er{MaxVacFull1ninshtrgravortghhghgjkgghklhjgkghghjjkjhjkkggjkhjkhjjhhfhjhkjhjjhkhbbgjlmkmmhzzzyyykkknnnNWBWHWPPNjjlllkkkkkkljkkjjkjkhjjhhgghgggg}
and
\er{MaxVacFull1ninshtrgravortghhghgjkgghklhjgkghghjjkjhjkkggjkhjkhjjhhfhjhkjhjjhkhbbgjlmkmmhzzzyyykkknnnNWBWHWPPNjjlllkkkkkkljkkjjkjk}
are still valid if
\er{MaxVacFull1yiyiyninshtrgravortghhghgjkgghklhjgkghghjjkjhjkkggjkhjkhjjhhfhjhkjkhbbgjhzzzyyykkknnnNWBWHWPPNjjhjhjhjhplllhjggjl;k}
and
\er{MaxVacFull1yiyiyninshtrgravortghhghgjkgghklhjghjhnkkghghjjkjhjkkggjkhjkhjjhhfhjhkjkhbbgjhzzzyyykkknnnNWBWHWPPNjjhjhjhjhplllhjggjl;k}
hold only in some specific (possibly artificial) inertial or
non-inertial cartesian coordinate system.

Next we proceed in two cases:
\begin{itemize}
\item[{\bf (i)}]
In the case of the simple models of inviscid gas.
\item[{\bf (ii)}]
In the case of inviscid barotropic fluid.
\end{itemize}
In the case of inviscid gas, approximating
\er{MaxVacFull1ninshtrgravortghhghgjkgghklhjgkghghjjkjhjkkggjkhjkhjjhhfhjhkjkhbbgjhzzzyyykkknnnNWBWHWPPNjjhjhjhjhugyyy},
by taking into account up to the first order terms with $\vec
u_1,\mu_1,p_1$ and using
\er{MaxVacFull1nkkkinshtrgravortghhghgjkgghklhjgkghghjjkjhjkkggjkhjkhjjhhfhjhkjkhbbgjhzzzyyykkknnhjhhnNWBWHWPPNjjhjhjhjhplllhjggjll;;lijiih},
we infer:
\begin{multline}\label{MaxVacFull1ninshtrgravortghhghgjkgghklhjnnmnmmngkghghjjkjhjkkggjkhjkhjjhhfhjhkjkhbbgjhzzzyyykkknnnNWBWHWPPNjjhjhjhjhghg}
\mu\left(\frac{\partial}{\partial t}\left(\frac{E}{\mu}\right)+\vec
u\cdot\nabla_{\vec x}\left(\frac{E}{\mu}\right)\right)=\\
(\mu_0+\mu_1)\left(\frac{\partial}{\partial
t}\left(F\left(\mu_0+\mu_1,p_0+p_1\right)\right)+(\vec u_0+\vec
u_1)\cdot\nabla_{\vec
x}\left(F\left(\mu_0+\mu_1,p_0+p_1\right)\right)\right)\\
\approx \mu_1\left(\frac{\partial}{\partial
t}\left(F\left(\mu_0,p_0\right)\right)+\vec u_0\cdot\nabla_{\vec
x}\left(F\left(\mu_0,p_0\right)\right)\right)\\
+\mu_0\left(\frac{\partial}{\partial
t}\left(F\left(\mu_0+\mu_1,p_0+p_1\right)\right)+\vec
u_0\cdot\nabla_{\vec
x}\left(F\left(\mu_0+\mu_1,p_0+p_1\right)\right)+\vec
u_1\cdot\nabla_{\vec x}\left(F\left(\mu_0,p_0\right)\right)\right)
\\
\approx \mu_0\left(\frac{\partial}{\partial
t}\left(F\left(\mu_0,p_0\right)\right)+\vec u_0\cdot\nabla_{\vec
x}\left(F\left(\mu_0,p_0\right)\right)\right)\\
+\mu_1\left(\frac{\partial}{\partial
t}\left(F\left(\mu_0,p_0\right)\right)+\vec u_0\cdot\nabla_{\vec
x}\left(F\left(\mu_0,p_0\right)\right)\right)+\mu_0\vec
u_1\cdot\nabla_{\vec x}\left(F\left(\mu_0,p_0\right)\right)\\
+\mu_0\left(\frac{\partial}{\partial t}\left(\mu_1\frac{\partial
F}{\partial\mu}\left(\mu_0,p_0\right)+p_1\frac{\partial F}{\partial
p}\left(\mu_0,p_0\right)\right)+\vec u_0\cdot\nabla_{\vec
x}\left(\mu_1\frac{\partial
F}{\partial\mu}\left(\mu_0,p_0\right)+p_1\frac{\partial F}{\partial
p}\left(\mu_0,p_0\right)\right)\right)
\\
\approx -\left(div_{\vec x}\vec u_1\right)p_0-\left(div_{\vec x}\vec
u_0\right)p_1-\left(div_{\vec x}\vec u_0\right)p_0-\Div_{\vec x}\vec
q+\left(\vec j-\rho\vec u\right)\cdot\left(\vec E+\frac{1}{c}\vec
u\times\vec B\right).
\end{multline}
In particular, averaging of
\er{MaxVacFull1ninshtrgravortghhghgjkgghklhjnnmnmmngkghghjjkjhjkkggjkhjkhjjhhfhjhkjkhbbgjhzzzyyykkknnnNWBWHWPPNjjhjhjhjhghg}
gives:
\begin{multline}\label{MaxVacFull1ninshtrgravortghhghgjkgghklhjnnmnmmngkghghjjkjhjkkggjkhjkhjjhhfhjhkjkhbbgjhzzzyyykkknnnNWBWHWPPNjjhjhjhjhghghhj}
\mu_0\left(\frac{\partial}{\partial
t}\left(F\left(\mu_0,p_0\right)\right)+\vec u_0\cdot\nabla_{\vec
x}\left(F\left(\mu_0,p_0\right)\right)\right) \approx
-\left(div_{\vec x}\vec u_0\right)p_0-\Div_{\vec x}\vec q+\left(\vec
j-\rho\vec u\right)\cdot\left(\vec E+\frac{1}{c}\vec u\times\vec
B\right).
\end{multline}
Here we again used the fact that that $\vec v, \vec E,\vec
B,\rho,\vec j$ and $\Div_{\vec x}\vec q$ can be replaced by their
spatial and temporal averages. Thus, subtracting
\er{MaxVacFull1ninshtrgravortghhghgjkgghklhjnnmnmmngkghghjjkjhjkkggjkhjkhjjhhfhjhkjkhbbgjhzzzyyykkknnnNWBWHWPPNjjhjhjhjhghghhj}
from the unaveraged
\er{MaxVacFull1ninshtrgravortghhghgjkgghklhjnnmnmmngkghghjjkjhjkkggjkhjkhjjhhfhjhkjkhbbgjhzzzyyykkknnnNWBWHWPPNjjhjhjhjhghg}
gives:
\begin{multline}\label{MaxVacFull1ninshtrgravortghhghgjkgghklhjnnmnmmngkghghjjkjhjkkggjkhjkhjjhhfhjhkjkhbbgjhzzzyyykkknnnNWBWHWPPNjjhjhjhjhghgjkhkhlkkhhk}
-\left(div_{\vec x}\vec u_1\right)p_0-\left(div_{\vec x}\vec
u_0\right)p_1\approx\mu_1\left(\frac{\partial}{\partial
t}\left(F\left(\mu_0,p_0\right)\right)+\vec u_0\cdot\nabla_{\vec
x}\left(F\left(\mu_0,p_0\right)\right)\right)+\mu_0\vec
u_1\cdot\nabla_{\vec x}\left(F\left(\mu_0,p_0\right)\right)\\
+\mu_0\left(\frac{\partial}{\partial t}\left(\mu_1\frac{\partial
F}{\partial\mu}\left(\mu_0,p_0\right)+p_1\frac{\partial F}{\partial
p}\left(\mu_0,p_0\right)\right)+\vec u_0\cdot\nabla_{\vec
x}\left(\mu_1\frac{\partial
F}{\partial\mu}\left(\mu_0,p_0\right)+p_1\frac{\partial F}{\partial
p}\left(\mu_0,p_0\right)\right)\right).
\end{multline}
Therefore, using
\er{MaxVacFull1yiyiyninshtrgravortghhghgjkgghklhjghjhnkkghghjjkjhjkkggjkhjkhjjhhfhjhkjkhbbgjhzzzyyykkknnnNWBWHWPPNjjhjhjhjhplllhjggjl;k}
we further approximate
\er{MaxVacFull1ninshtrgravortghhghgjkgghklhjnnmnmmngkghghjjkjhjkkggjkhjkhjjhhfhjhkjkhbbgjhzzzyyykkknnnNWBWHWPPNjjhjhjhjhghgjkhkhlkkhhk}
as:
\begin{multline}\label{MaxVacFull1ninshtrgravortghhghgjkgghklhjnnmnmmngkghghjjkjhjkkggjkhjkhjjhhfhjhkjkhbbgjhzzzyyykkknnnNWBWHWPPNjjhjhjhjhghgjkhkhlkkhhkhkhh}
\mu_0\frac{\partial
F}{\partial\mu}\left(\mu_0,p_0\right)\left(\frac{\partial\mu_1}{\partial
t}+\vec u_0\cdot\nabla_{\vec x}\mu_1\right)+\mu_0\frac{\partial
F}{\partial p}\left(\mu_0,p_0\right)\left(\frac{\partial p}{\partial
t}+\vec u_0\cdot\nabla_{\vec x}p_1\right) \approx -\left(div_{\vec
x}\vec u_1\right)p_0.
\end{multline}
Thus, inserting
\er{MaxVacFull1ninshtrgravortghhghgjkgghklhjgkghghjjkjhjkkggjkhjkhjjhhfhjhkhjjkhjgzzzyyykkknnnNWBWHWPPNjjlkjljjjkijijji}
into
\er{MaxVacFull1ninshtrgravortghhghgjkgghklhjnnmnmmngkghghjjkjhjkkggjkhjkhjjhhfhjhkjkhbbgjhzzzyyykkknnnNWBWHWPPNjjhjhjhjhghgjkhkhlkkhhkhkhh}
gives:
\begin{multline}\label{MaxVacFull1ninshtrgravortghhghgjkgghklhjnnmnmmngkghghjjkjhjkkggjkhjkhjjhhfhjhkjkhbbgjhzzzyyykkknnnNWBWHWPPNjjhjhjhjhghgjkhkhlkkhhkhkhhhkhhjgjgg}
\mu_0\frac{\partial
F}{\partial\mu}\left(\mu_0,p_0\right)\left(\frac{\partial\mu_1}{\partial
t}+\vec u_0\cdot\nabla_{\vec x}\mu_1\right)+\mu_0\frac{\partial
F}{\partial p}\left(\mu_0,p_0\right)\left(\frac{\partial p}{\partial
t}+\vec u_0\cdot\nabla_{\vec x}p_1\right) \approx
\frac{p_0}{\mu_0}\left(\frac{\partial\mu_1}{\partial t}+\vec
u_0\cdot\nabla_{\vec x}\mu_1\right).
\end{multline}
So,
\begin{multline}\label{MaxVacFull1ninshtrgravortghhghgjkgghklhjnnmnmmngkghghjjkjhjkkggjkhjkhjjhhfhjhkjkhbbgjhzzzyyykkknnnNWBWHWPPNjjhjhjhjhghgjkhkhlkkhhkhkhhhkhhjgjggjkkhhk1}
\left(\mu_0\frac{\partial
F}{\partial\mu}\left(\mu_0,p_0\right)-\frac{p_0}{\mu_0}\right)\left(\frac{\partial\mu_1}{\partial
t}+\vec u_0\cdot\nabla_{\vec x}\mu_1\right)+\mu_0\frac{\partial
F}{\partial p}\left(\mu_0,p_0\right)\left(\frac{\partial p}{\partial
t}+\vec u_0\cdot\nabla_{\vec x}p_1\right) \approx 0.
\end{multline}
Therefore, denoting the scalar field  $g(\vec x,t)$ defined by
\begin{equation}\label{MaxVacFull1ninshtrgravortghhghgjkgghklhjnnmnmmngkghghjjkjhjkkggjkhjkhjjhhfhjhkjkhbbgjhzzzyyykkknnnNWBWHWPPNjjhjhjhjhghgjkhkhlkkhhkhkhhhkhhjgjggjkkhhk}
g:=\mu_0\frac{\partial F}{\partial
p}\left(\mu_0,p_0\right)p_1-\left(\frac{p_0}{\mu_0}-\mu_0\frac{\partial
F}{\partial\mu}\left(\mu_0,p_0\right)\right)\mu_1,
\end{equation}
inserting it into
\er{MaxVacFull1ninshtrgravortghhghgjkgghklhjnnmnmmngkghghjjkjhjkkggjkhjkhjjhhfhjhkjkhbbgjhzzzyyykkknnnNWBWHWPPNjjhjhjhjhghgjkhkhlkkhhkhkhhhkhhjgjggjkkhhk1}
and using
\er{MaxVacFull1yiyiyninshtrgravortghhghgjkgghklhjghjhnkkghghjjkjhjkkggjkhjkhjjhhfhjhkjkhbbgjhzzzyyykkknnnNWBWHWPPNjjhjhjhjhplllhjggjl;k}
again, we deduce
\begin{equation}\label{MaxVacFull1ninshtrgravortghhghgjkgghklhjnnmnmmngkghghjjkjhjkkggjkhjkhjjhhfhjhkjkhbbgjhzzzyyykkknnnNWBWHWPPNjjhjhjhjhghgjkhkhlkkhhkhkhhhkhhjgjggjkkhhkkhhkhkhg}
\frac{\partial g}{\partial t}+\vec u_0\cdot\nabla_{\vec x}g\approx
0.
\end{equation}
In particular we deduce
\begin{equation}\label{MaxVacFull1ninshtrgravortghhghgjkgghklhjnnmnmmngkghghjjkjhjkkggjkhjkhjjhhfhjhkjkhbbgjhzzzyyykkknnnNWBWHWPPNjjhjhjhjhghgjkhkhlkkhhkhkhhhkhhjgjggjkkhhkkhhkhkhgjgj}
\text{If}\quad g(\vec x,0)\approx 0\quad\text{then}\quad g(\vec
x,t)\approx 0\quad\forall\, t.
\end{equation}
In this case, by
\er{MaxVacFull1ninshtrgravortghhghgjkgghklhjnnmnmmngkghghjjkjhjkkggjkhjkhjjhhfhjhkjkhbbgjhzzzyyykkknnnNWBWHWPPNjjhjhjhjhghgjkhkhlkkhhkhkhhhkhhjgjggjkkhhk}
we deduce
\begin{equation}\label{MaxVacFull1ninshtrgravortghhghgjkgghklhjnnmnmmngkghghjjkjhjkkggjkhjkhjjhhfhjhkjkhbbgjhzzzyyykkknnnNWBWHWPPNjjhjhjhjhghgjkhkhlkkhhkhkhhhkhhjgjggjkkhhkjhjjhjhjhjhjjgg}
\mu_1\approx\left(\frac{p_0}{\mu_0}-\mu_0\frac{\partial
F}{\partial\mu}\left(\mu_0,p_0\right)\right)^{-1}\mu_0\frac{\partial
F}{\partial p}\left(\mu_0,p_0\right)\,p_1.
\end{equation}
Therefore, inserting
\er{MaxVacFull1ninshtrgravortghhghgjkgghklhjnnmnmmngkghghjjkjhjkkggjkhjkhjjhhfhjhkjkhbbgjhzzzyyykkknnnNWBWHWPPNjjhjhjhjhghgjkhkhlkkhhkhkhhhkhhjgjggjkkhhkjhjjhjhjhjhjjgg}
into
\er{MaxVacFull1ninshtrgravortghhghgjkgghklhjgkghghjjkjhjkkggjkhjkhjjhhfhjhkjhjjhkhbbgjlmkmmhzzzyyykkknnnNWBWHWPPNjjlllkkkkkkljkkjjkjkhjjhhgghgggg}
and using again
\er{MaxVacFull1yiyiyninshtrgravortghhghgjkgghklhjghjhnkkghghjjkjhjkkggjkhjkhjjhhfhjhkjkhbbgjhzzzyyykkknnnNWBWHWPPNjjhjhjhjhplllhjggjl;k}
we finally obtain the wave equation for the oscillating part of the
pressure $p_1$ of the form:
\begin{equation}\label{MaxVacFull1ninshtrgravortghhghgjkgghklhjgkghghjjkjhjkkggjkhjkhjjhhfhjhkjhjjhkhbbgjlmkmmhzzzyyykkknnnNWBWHWPPNjjlllkkkkkkljkkjjkjkhjjhhgghgggghjjhgjggh}
\frac{\partial}{\partial
t}\left\{\frac{1}{c^2_0}\left(\frac{\partial p_1}{\partial t}+\vec
u_0\cdot\nabla_{\vec x}p_1\right)\right\}+\Div_{\vec
x}\left\{\frac{1}{c^2_0}\left(\frac{\partial p_1}{\partial t}+\vec
u_0\cdot\nabla_{\vec x}p_1\right) \vec u_0\right\}
\approx\Delta_{\vec x} p_1,
\end{equation}
where we denote
\begin{equation}\label{MaxVacFull1ninshtrgravortghhghgjkgghklhjgkghghjjkjhjkkggjkhjkhjjhhfhjhkjhjjhkhbbgjlmkmmhzzzyyykkknnnNWBWHWPPNjjlllkkkkkkljkkjjkjkhjjhhgghgggghjjhgjgghhkhjhjh}
c_0:=\sqrt{\frac{\left(\frac{p_0}{\mu_0}-\mu_0\frac{\partial
F}{\partial\mu}\left(\mu_0,p_0\right)\right)}{\mu_0\frac{\partial
F}{\partial p}\left(\mu_0,p_0\right)}}.
\end{equation}
Moreover, the oscillating parts of the density $\mu_1$ and the
velocity $\vec u_1$ can be found from
\er{MaxVacFull1ninshtrgravortghhghgjkgghklhjnnmnmmngkghghjjkjhjkkggjkhjkhjjhhfhjhkjkhbbgjhzzzyyykkknnnNWBWHWPPNjjhjhjhjhghgjkhkhlkkhhkhkhhhkhhjgjggjkkhhkjhjjhjhjhjhjjgg}
and
\er{MaxVacFull1ninshtrgravortghhghgjkgghklhjgkghghjjkjhjkkggjkhjkhjjhhfhjhkjhjjhkhbbgjlmkmmhzzzyyykkknnnNWBWHWPPNjjlllkkkkkkljkkjjkjk}
respectively, i.e. we have
\begin{equation}\label{MaxVacFull1ninshtrgravortghhghgjkgghklhjnnmnmmngkghghjjkjhjkkggjkhjkhjjhhfhjhkjkhbbgjhzzzyyykkknnnNWBWHWPPNjjhjhjhjhghgjkhkhlkkhhkhkhhhkhhjgjggjkkhhkjhjjhjhjhjhjjggijjhj}
\mu_1\approx\left(\frac{p_0}{\mu_0}-\mu_0\frac{\partial
F}{\partial\mu}\left(\mu_0,p_0\right)\right)^{-1}\mu_0\frac{\partial
F}{\partial p}\left(\mu_0,p_0\right)\,p_1,
\end{equation}
and
\begin{equation}\label{MaxVacFull1ninshtrgravortghhghgjkgghklhjgkghghjjkjhjkkggjkhjkhjjhhfhjhkjhjjhkhbbgjlmkmmhzzzyyykkknnnNWBWHWPPNjjlllkkkkkkljkkjjkjk9uu9uuiiuuiiu}
\mu_0\left(\frac{\partial\vec u_1}{\partial t}-curl_{\vec
x}\left\{\vec u_0\times\vec u_1\right\}+(\Div_{\vec x}\vec
u_1)\,\vec u_0\right) \approx-\nabla_{\vec x} p_1.
\end{equation}
Note that, as before, it can be easily proved that
\er{MaxVacFull1ninshtrgravortghhghgjkgghklhjgkghghjjkjhjkkggjkhjkhjjhhfhjhkjhjjhkhbbgjlmkmmhzzzyyykkknnnNWBWHWPPNjjlllkkkkkkljkkjjkjkhjjhhgghgggghjjhgjggh},
\er{MaxVacFull1ninshtrgravortghhghgjkgghklhjnnmnmmngkghghjjkjhjkkggjkhjkhjjhhfhjhkjkhbbgjhzzzyyykkknnnNWBWHWPPNjjhjhjhjhghgjkhkhlkkhhkhkhhhkhhjgjggjkkhhkjhjjhjhjhjhjjggijjhj}
and
\er{MaxVacFull1ninshtrgravortghhghgjkgghklhjgkghghjjkjhjkkggjkhjkhjjhhfhjhkjhjjhkhbbgjlmkmmhzzzyyykkknnnNWBWHWPPNjjlllkkkkkkljkkjjkjk9uu9uuiiuuiiu}
are invariant under the change of inertial or non-inertial cartesian
coordinate system. Thus, as before,
\er{MaxVacFull1ninshtrgravortghhghgjkgghklhjgkghghjjkjhjkkggjkhjkhjjhhfhjhkjhjjhkhbbgjlmkmmhzzzyyykkknnnNWBWHWPPNjjlllkkkkkkljkkjjkjkhjjhhgghgggghjjhgjggh},
\er{MaxVacFull1ninshtrgravortghhghgjkgghklhjnnmnmmngkghghjjkjhjkkggjkhjkhjjhhfhjhkjkhbbgjhzzzyyykkknnnNWBWHWPPNjjhjhjhjhghgjkhkhlkkhhkhkhhhkhhjgjggjkkhhkjhjjhjhjhjhjjggijjhj}
and
\er{MaxVacFull1ninshtrgravortghhghgjkgghklhjgkghghjjkjhjkkggjkhjkhjjhhfhjhkjhjjhkhbbgjlmkmmhzzzyyykkknnnNWBWHWPPNjjlllkkkkkkljkkjjkjk9uu9uuiiuuiiu}
are still valid if
\er{MaxVacFull1yiyiyninshtrgravortghhghgjkgghklhjgkghghjjkjhjkkggjkhjkhjjhhfhjhkjkhbbgjhzzzyyykkknnnNWBWHWPPNjjhjhjhjhplllhjggjl;k}
and
\er{MaxVacFull1yiyiyninshtrgravortghhghgjkgghklhjghjhnkkghghjjkjhjkkggjkhjkhjjhhfhjhkjkhbbgjhzzzyyykkknnnNWBWHWPPNjjhjhjhjhplllhjggjl;k}
hold only in some specific (possibly artificial) inertial or
non-inertial cartesian coordinate system.

In particular, in the case of in the case of the simplest ideal gas
where \er{noninchredPPNbjjhjjkhkhkljjljljklhjhjhjghhgghjhhj} and
\er{noninchredPPNbjjhjjkhkhlkkkklhjgghgh} are valid, by
\er{MaxVacFull1nkkkinshtrgravortghhghgjkgghklhjgkghghjjkjhjkkggjkhjkhjjhhfhjhkjkhbbgjhzzzyyykkknnhjhhnNWBWHWPPNjjhjhjhjhplllhjggjll;;lijiihkhhkgjgj}
we have
\begin{equation}\label{MaxVacFull1nkkkinshtrgravortghhghgjkgghklhjgkghghjjkjhjkkggjkhjkhjjhhfhjhkjkhbbgjhzzzyyykkknnhjhhnNWBWHWPPNjjhjhjhjhplllhjggjll;;lijiihkhhkgjgjhjhjhbcc}
\frac{E}{\mu}=F\left(\mu,p\right):=\gamma_0\,\frac{p}{\mu},
\end{equation}
and therefore, inserting
\er{MaxVacFull1nkkkinshtrgravortghhghgjkgghklhjgkghghjjkjhjkkggjkhjkhjjhhfhjhkjkhbbgjhzzzyyykkknnhjhhnNWBWHWPPNjjhjhjhjhplllhjggjll;;lijiihkhhkgjgjhjhjhbcc}
into
\er{MaxVacFull1ninshtrgravortghhghgjkgghklhjgkghghjjkjhjkkggjkhjkhjjhhfhjhkjhjjhkhbbgjlmkmmhzzzyyykkknnnNWBWHWPPNjjlllkkkkkkljkkjjkjkhjjhhgghgggghjjhgjgghhkhjhjh}
gives:
\begin{equation}\label{MaxVacFull1ninshtrgravortghhghgjkgghklhjgkghghjjkjhjkkggjkhjkhjjhhfhjhkjhjjhkhbbgjlmkmmhzzzyyykkknnnNWBWHWPPNjjlllkkkkkkljkkjjkjkhjjhhgghgggghjjhgjgghhkhjhjhjjggjjgjgjg}
c_0:=\sqrt{\frac{\left(1+\gamma_0\right)p_0}{\gamma_0\mu_0}}.
\end{equation}
Moreover, in this case
\er{MaxVacFull1ninshtrgravortghhghgjkgghklhjnnmnmmngkghghjjkjhjkkggjkhjkhjjhhfhjhkjkhbbgjhzzzyyykkknnnNWBWHWPPNjjhjhjhjhghgjkhkhlkkhhkhkhhhkhhjgjggjkkhhkjhjjhjhjhjhjjgg}
reeds as:
\begin{equation}\label{MaxVacFull1ninshtrgravortghhghgjkgghklhjnnmnmmngkghghjjkjhjkkggjkhjkhjjhhfhjhkjkhbbgjhzzzyyykkknnnNWBWHWPPNjjhjhjhjhghgjkhkhlkkhhkhkhhhkhhjgjggjkkhhkjhjjhjhjhjhjjgghyuyh}
\mu_1\approx\frac{\gamma_0}{1+\gamma_0}\,\frac{\mu_0}{p_0}\,p_1.
\end{equation}

On the other hand, in the case barotropic fluid the pressure $p$ is
a function of the density $\mu$ \underline{only}, i.e.
\begin{equation}\label{MaxVacFull1ninshtrgravortghhghgjkgghklhjnnmnmmngkghghjjkjhjkkggjkhjkhjjhhfhjhkjkhbbgjhzzzyyykkknnnNWBWHWPPNjjhjhjhjhghgjkhkhlkkhhjgggggggggggggggggggggggggggggg}
p=\mathcal{R}(\mu).
\end{equation}
Thus, inserting
\er{MaxVacFull1ninshtrgravortghhghgjkgghklhjgkghghjjkjhjkkggjkhjkhjjhhfhjhkjkhbbgjhzzzyyykkknnnNWBWHWPPNjjhjhjhjhplllhjggjl}
into
\er{MaxVacFull1ninshtrgravortghhghgjkgghklhjnnmnmmngkghghjjkjhjkkggjkhjkhjjhhfhjhkjkhbbgjhzzzyyykkknnnNWBWHWPPNjjhjhjhjhghgjkhkhlkkhhjgggggggggggggggggggggggggggggg}
and using
\er{MaxVacFull1yiyiyninshtrgravortghhghgjkgghklhjgkghghjjkjhjkkggjkhjkhjjhhfhjhkjkhbbgjhzzzyyykkknnnNWBWHWPPNjjhjhjhjhplllhjggjl;k}
we deduce
\begin{equation}\label{MaxVacFull1ninshtrgravortghhghgjkgghklhjnnmnmmngkghghjjkjhjkkggjkhjkhjjhhfhjhkjkhbbgjhzzzyyykkknnnNWBWHWPPNjjhjhjhjhghgjkhkhlkkhhjggggggggggggggggggggggggggggggjggjg}
p_1\approx\frac{d\mathcal{R}}{d\mu}(\mu_0)\,\mu_1.
\end{equation}
Therefore, inserting
\er{MaxVacFull1ninshtrgravortghhghgjkgghklhjnnmnmmngkghghjjkjhjkkggjkhjkhjjhhfhjhkjkhbbgjhzzzyyykkknnnNWBWHWPPNjjhjhjhjhghgjkhkhlkkhhjggggggggggggggggggggggggggggggjggjg}
into
\er{MaxVacFull1ninshtrgravortghhghgjkgghklhjgkghghjjkjhjkkggjkhjkhjjhhfhjhkjhjjhkhbbgjlmkmmhzzzyyykkknnnNWBWHWPPNjjlllkkkkkkljkkjjkjkhjjhhgghgggg}
and using again
\er{MaxVacFull1yiyiyninshtrgravortghhghgjkgghklhjghjhnkkghghjjkjhjkkggjkhjkhjjhhfhjhkjkhbbgjhzzzyyykkknnnNWBWHWPPNjjhjhjhjhplllhjggjl;k}
we finally obtain the analogous to
\er{MaxVacFull1ninshtrgravortghhghgjkgghklhjgkghghjjkjhjkkggjkhjkhjjhhfhjhkjhjjhkhbbgjlmkmmhzzzyyykkknnnNWBWHWPPNjjlllkkkkkkljkkjjkjkhjjhhgghgggghjjhgjggh}
wave equation for the oscillating part of the pressure $p_1$  of the
form:
\begin{equation}\label{MaxVacFull1ninshtrgravortghhghgjkgghklhjgkghghjjkjhjkkggjkhjkhjjhhfhjhkjhjjhkhbbgjlmkmmhzzzyyykkknnnNWBWHWPPNjjlllkkkkkkljkkjjkjkhjjhhgghgggghjjhgjgghjkjkjk}
\frac{\partial}{\partial
t}\left\{\frac{1}{c^2_0}\left(\frac{\partial p_1}{\partial t}+\vec
u_0\cdot\nabla_{\vec x}p_1\right)\right\}+\Div_{\vec
x}\left\{\frac{1}{c^2_0}\left(\frac{\partial p_1}{\partial t}+\vec
u_0\cdot\nabla_{\vec x}p_1\right) \vec u_0\right\}
\approx\Delta_{\vec x} p_1,
\end{equation}
where we denote
\begin{equation}\label{MaxVacFull1ninshtrgravortghhghgjkgghklhjgkghghjjkjhjkkggjkhjkhjjhhfhjhkjhjjhkhbbgjlmkmmhzzzyyykkknnnNWBWHWPPNjjlllkkkkkkljkkjjkjkhjjhhgghgggghjjhgjgghhkhjhjhhkhhhhg}
c_0:=\sqrt{\frac{d\mathcal{R}}{d\mu}(\mu_0)}.
\end{equation}
Moreover, the oscillating parts of the density $\mu_1$ and the
velocity $\vec u_1$ can be found from
\er{MaxVacFull1ninshtrgravortghhghgjkgghklhjnnmnmmngkghghjjkjhjkkggjkhjkhjjhhfhjhkjkhbbgjhzzzyyykkknnnNWBWHWPPNjjhjhjhjhghgjkhkhlkkhhjggggggggggggggggggggggggggggggjggjg}
and
\er{MaxVacFull1ninshtrgravortghhghgjkgghklhjgkghghjjkjhjkkggjkhjkhjjhhfhjhkjhjjhkhbbgjlmkmmhzzzyyykkknnnNWBWHWPPNjjlllkkkkkkljkkjjkjk}
respectively, i.e. we have
\begin{equation}\label{hihihhhoo}
p_1\approx\frac{d\mathcal{R}}{d\mu}(\mu_0)\,\mu_1,
\end{equation}
and
\begin{equation}\label{MaxVacFull1ninshtrgravortghhghgjkgghklhjgkghghjjkjhjkkggjkhjkhjjhhfhjhkjhjjhkhbbgjlmkmmhzzzyyykkknnnNWBWHWPPNjjlllkkkkkkljkkjjkjk9uu9uuiiuuiiujkhjhjjojj}
\mu_0\left(\frac{\partial\vec u_1}{\partial t}-curl_{\vec
x}\left\{\vec u_0\times\vec u_1\right\}+(\Div_{\vec x}\vec
u_1)\,\vec u_0\right) \approx-\nabla_{\vec x} p_1.
\end{equation}
Again note that
\er{MaxVacFull1ninshtrgravortghhghgjkgghklhjgkghghjjkjhjkkggjkhjkhjjhhfhjhkjhjjhkhbbgjlmkmmhzzzyyykkknnnNWBWHWPPNjjlllkkkkkkljkkjjkjkhjjhhgghgggghjjhgjgghjkjkjk},
\er{hihihhhoo} and
\er{MaxVacFull1ninshtrgravortghhghgjkgghklhjgkghghjjkjhjkkggjkhjkhjjhhfhjhkjhjjhkhbbgjlmkmmhzzzyyykkknnnNWBWHWPPNjjlllkkkkkkljkkjjkjk9uu9uuiiuuiiujkhjhjjojj}
are invariant under the change of inertial or non-inertial cartesian
coordinate system. Thus, as before,
\er{MaxVacFull1ninshtrgravortghhghgjkgghklhjgkghghjjkjhjkkggjkhjkhjjhhfhjhkjhjjhkhbbgjlmkmmhzzzyyykkknnnNWBWHWPPNjjlllkkkkkkljkkjjkjkhjjhhgghgggghjjhgjgghjkjkjk},
\er{hihihhhoo} and
\er{MaxVacFull1ninshtrgravortghhghgjkgghklhjgkghghjjkjhjkkggjkhjkhjjhhfhjhkjhjjhkhbbgjlmkmmhzzzyyykkknnnNWBWHWPPNjjlllkkkkkkljkkjjkjk9uu9uuiiuuiiujkhjhjjojj}
are still valid if
\er{MaxVacFull1yiyiyninshtrgravortghhghgjkgghklhjgkghghjjkjhjkkggjkhjkhjjhhfhjhkjkhbbgjhzzzyyykkknnnNWBWHWPPNjjhjhjhjhplllhjggjl;k}
and
\er{MaxVacFull1yiyiyninshtrgravortghhghgjkgghklhjghjhnkkghghjjkjhjkkggjkhjkhjjhhfhjhkjkhbbgjhzzzyyykkknnnNWBWHWPPNjjhjhjhjhplllhjggjl;k}
hold only in some specific (possibly artificial) inertial or
non-inertial cartesian coordinate system.

\subsection{Propagation of waves in the moving elastic
body}\label{gjgfhffhfh}
Consistently with the general equation of
motion of a continuum medium
\er{MaxVacFull1ninshtrgravortghhghgjkgghklhjgkghghjjkjhjkkggjkhjkhjjhhfhjhkjkhbbgjhzzzyyykkknnnNWBWHWPPNjj},
consider the motion of an elastic body:
\begin{multline}\label{MaxVacFull1ninshtrgravortghhghgjkgghklhjgkghghjjkjhjkkggjkhjkhjjhhfhjhkjkhbbgjhzzzyyykkknnnNWBWHWPPNjjjljkjkkhhhhhhhhhhhh}
\mu\left(\frac{\partial\vec u}{\partial t}+d_{\vec x}\vec u\cdot
\vec u\right)=\\-\mu\vec u\times curl_{\vec x}\vec
v+\mu\left(\partial_{t}\vec v+\nabla_{\vec x}\frac{1}{2}|\vec
v|^2\right)+\rho\vec E+\frac{1}{c}\vec j\times\vec B+\Div_{\vec
x}\left\{\alpha\mathcal{E}(\vec
x,t)+\beta\left(tr\left\{\mathcal{E}(\vec
x,t)\right\}\right)I\right\}.
\end{multline}
where
\begin{equation}\label{hjgjgjgkllujkjjjkhhjhljjkjkljjljjhhh}
\mathcal{E}(\vec x,t):=\frac{1}{2}\left(I-\left\{d_{\vec x}\vec
f(\vec x,t)\right\}^T\cdot d_{\vec x}\vec f(\vec x,t)\right),
\end{equation}
consistently with \er{hjgjgjgkllujkjjjkhhjhljjkjkljjljj} and
\er{MaxVacFull1ninshtrgravortghhghgjkgghklhjgkghghjjkjhjkkggjkhjkhjjhhfhjhkhjjkhjgzzzyyykkknnnNWBWHWPPNjjhhjhjkphkhkhhklklkl},
with
\begin{equation}\label{hjgghhghhffhkkmkljjnnnnghgghgh}
\begin{cases}
\frac{\partial\vec f}{\partial t}\left(\vec x,t\right)+d_{\vec
x}\vec f\left(\vec x,t\right)\cdot\vec u\left(\vec x,t\right)=0
\\
\vec f(\vec x,t_0)=\vec x,
\end{cases}
\end{equation}
consistently with \er{hjgghhghhffhkkmkljjnnnn}, where the initial
instant of time $t_0$ is chosen (possibly fictitiously) in such a
way that our elastic body is relaxed at time $t_0$. Moreover, by
\er{hjhjjttuttutukjhkhlkkjkhjljjzz}
we have
\begin{equation}\label{hjhjjttuttutukjhkhlkkjkhjljj}
\mu\left(\vec x,t\right)=\mu(\vec x,t_0)\left|det\{d_{\vec x}\vec
f(\vec x,t)\}\right|.
\end{equation}

Next we assume that
\begin{multline}\label{MaxVacFull1ninshtrgravortghhghgjkgghklhjgkghghjjkjhjkkggjkhjkhjjhhfhjhkjkhbbgjhzzzyyykkknnnNWBWHWPPNjjhjhjhjhplllhjggjlcc}
\vec u(\vec x,t)=\vec u_0(\vec x,t)+\vec u_1(\vec x,t),\quad
\mu(\vec x,t)=\mu_0(\vec x,t)+\mu_1(\vec x,t),\\ \vec f(\vec
x,t)=\vec f_0(\vec x,t)+\vec f_1(\vec x,t)\quad\text{and}\quad
\mathcal{E}(\vec x,t)=\mathcal{E}_{0}(\vec x,t)+\mathcal{E}_{1}(\vec
x,t)
\end{multline}
where $\vec u_0(\vec x,t),\mu_0(\vec x,t),\vec f_0(\vec
x,t),\mathcal{E}_{0}(\vec x,t)$ are the averages of $\vec u(\vec
x,t),\mu(\vec x,t),\vec f(\vec x,t),\mathcal{E}(\vec x,t)$ on small
spatial and temporal intervals, surrounding the point $(\vec x,t)$.
Although we assume these intervals of space and time to be very
small, we also assume them to be quite \underline{macroscopic}. We
call $\vec u_1,\mu_1,\vec f_1,\mathcal{E}_{1}$ the oscillating parts
of $\vec u,\mu,\vec f,\mathcal{E}$ and we assume that they are small
with respect to the averages $\vec u_0,\mu_0,\vec f_0,\mathcal{E}_0$
i.e. we have:
\begin{equation}\label{MaxVacFull1yiyiyninshtrgravortghhghgjkgghklhjgkghghjjkjhjkkggjkhjkhjjhhfhjhkjkhbbgjhzzzyyykkknnnNWBWHWPPNjjhjhjhjhplllhjggjl;kcc}
|\vec u_1|\ll|\vec u_0|,\quad|\mu_1|\ll |\mu_0|,\quad|\vec f_1|\ll
|\vec f_0|.
\end{equation}
However, we assume that $\vec u_1,\mu_1,\vec f_1$ are highly
oscillate and thus they changes spatially and temporary much faster
than the averages  $\vec u_0,\mu_0,\vec f_0$ and the fields $\vec v,
\vec E,\vec B,\rho,\vec j$, i.e. we have:
\begin{multline}\label{MaxVacFull1yiyiyninshtrgravortghhghgjkgghklhjghjhnkkghghjjkjhjkkggjkhjkhjjhhfhjhkjkhbbgjhzzzyyykkknnnNWBWHWPPNjjhjhjhjhplllhjggjl;kcc}
\frac{|d_{\vec x}\alpha|}{|\alpha|}+\frac{|d_{\vec
x}\beta|}{|\beta|}+\frac{\left|d_{\vec x}\left(\rho\vec
E+\frac{1}{c}\vec j\times\vec B\right)\right|}{\left|\rho\vec
E+\frac{1}{c}\vec j\times\vec B\right|}+\frac{|d_{\vec x}\vec
v|}{|\vec v|}+\frac{|d_{\vec x}\mu_0|}{|\mu_0|}+\frac{|d_{\vec
x}\vec f_0|}{|\vec f_0|} +\frac{|d_{\vec x}\vec u_0|}{|\vec u_0|}
\ll \min{\left\{\frac{|d_{\vec x}\vec u_1|}{|\vec u_1|},\frac{|d_{\vec x}\mu_1|}{|\mu_1|},\frac{|d_{\vec x}\vec f_1|}{|\vec f_1|}\right\}},\\
\quad\quad
\frac{\left|\partial_t\left(\rho\vec E+\frac{1}{c}\vec j\times\vec
B\right)\right|}{\left|\rho\vec E+\frac{1}{c}\vec j\times\vec
B\right|}+\frac{|\partial_t\vec v|}{|\vec
v|}+\frac{|\partial_t\mu_0|}{|\mu_0|}+\frac{|\partial_t \vec
f_0|}{|\vec f_0|}+\frac{|\partial_t\vec u_0|}{|\vec u_0|}\ll
\min{\left\{\frac{|\partial_t\vec u_1|}{|\vec
u_1|},\frac{|\partial_t\mu_1|}{|\mu_1|},\frac{|\partial_t \vec
f_1|}{|\vec f_1|}\right\}},\\ \frac{|d^2_{\vec x}\vec f_0|}{|d_{\vec
x}\vec f_0|}\ll\frac{|d^2_{\vec x}\vec f_1|}{|d_{\vec x}\vec f_1|}
\quad\quad \frac{|\vec u_1|}{|\vec u_0|}\ll
\min{\left\{\frac{|d_{\vec x}\vec u_1|}{|d_{\vec x}\vec
u_0|},\frac{|d_{\vec x}\vec u_1|}{|d_{\vec x}\vec v|}\right\}},
\quad\quad\text{and} \quad\quad
\frac{|\mu_1|}{|\mu_0|}\ll\frac{|d_{\vec x}\vec u_1||\vec
u_0||\mu_0|}{\left|\rho\vec E+\frac{1}{c}\vec j\times\vec B\right|}.
\end{multline}
Finally, we assume that the fields $\vec v, \vec E,\vec B,\rho,\vec
j$ change slowly with respect to the oscillations of  $\vec
u_1,\mu_1,\vec f_1$ and thus we assume that $\vec v, \vec E,\vec
B,\rho,\vec j$ can be replaced by their spatial and temporal
averages. Note that $\mu,\mu_0,\mu_1$ behave like \underline{proper
scalar} fields, $\mathcal{E},\mathcal{E}_0,\mathcal{E}_1$ behave
like \underline{proper matrix} fields and $\vec u,\vec u_0$ behave
like \underline{speed-like vector}  fields under the change of
cartesian coordinate systems. Thus, since $\vec u_1=\vec u-\vec
u_0$, we deduce that $\vec u_1$ behaves like a \underline{proper
vector} field  under the change of cartesian coordinate systems.
Furthermore, using \er{hjgjgjgkllujkjjjkhhjhljjk} we deduce that the
vector field $\vec g_1\left(\vec x,t\right)$ defined by the
following:
\begin{equation}\label{jgjgghghghhkhkhkh}
\vec g_1\left(\vec x,t\right):=\left(d_{\vec x}\vec f_0\left(\vec
x,t\right)\right)^{-1}\cdot\vec f_1\left(\vec x,t\right),
\end{equation}
behaves like a \underline{proper vector} field. Finally, note that
obviously the averages of $\vec u_1,\mu_1,\vec f_1,\mathcal{E}_1$
vanish.

Next we would like to approximate
\er{hjgghhghhffhkkmkljjnnnnghgghgh} and
\er{MaxVacFull1ninshtrgravortghhghgjkgghklhjgkghghjjkjhjkkggjkhjkhjjhhfhjhkjkhbbgjhzzzyyykkknnnNWBWHWPPNjjjljkjkkhhhhhhhhhhhh}.
In the rough level of approximation we could just replace $\vec
u,\mu,\vec f$ by  $\vec u_0,\mu_0,\vec f_0$. However, we would like
to use a more delicate approximation, taking into the account the
first order terms with $\vec u_1,\mu_1,\vec f_1$. Then, as before,
by \er{hjgghhghhffhkkmkljjnnnnghgghgh} and
\er{MaxVacFull1ninshtrgravortghhghgjkgghklhjgkghghjjkjhjkkggjkhjkhjjhhfhjhkjkhbbgjhzzzyyykkknnnNWBWHWPPNjjhjhjhjhplllhjggjlcc}
and using
\er{MaxVacFull1yiyiyninshtrgravortghhghgjkgghklhjgkghghjjkjhjkkggjkhjkhjjhhfhjhkjkhbbgjhzzzyyykkknnnNWBWHWPPNjjhjhjhjhplllhjggjl;kcc}
we have:
\begin{equation}\label{hjgghhghhffhkkmkljjnnnnghgghghjhyjyy}
\frac{\partial\vec f_0}{\partial t}\left(\vec
x,t\right)+\frac{\partial\vec f_1}{\partial t}\left(\vec
x,t\right)+d_{\vec x}\vec f_0\left(\vec x,t\right)\cdot\vec
u_0\left(\vec x,t\right)+d_{\vec x}\vec f_0\left(\vec
x,t\right)\cdot\vec u_1\left(\vec x,t\right)+d_{\vec x}\vec
f_1\left(\vec x,t\right)\cdot\vec u_0\left(\vec x,t\right)\approx 0.
\end{equation}
Thus, averaging of \er{hjgghhghhffhkkmkljjnnnnghgghghjhyjyy} gives
\begin{equation}\label{hjgghhghhffhkkmkljjnnnnghgghghjhyjyyhh}
\frac{\partial\vec f_0}{\partial t}\left(\vec x,t\right)+d_{\vec
x}\vec f_0\left(\vec x,t\right)\cdot\vec u_0\left(\vec
x,t\right)\approx 0.
\end{equation}
Therefore, subtracting \er{hjgghhghhffhkkmkljjnnnnghgghghjhyjyyhh}
from unaveraged \er{hjgghhghhffhkkmkljjnnnnghgghghjhyjyy} we obtain
\begin{equation}\label{hjgghhghhffhkkmkljjnnnnghgghghjhyjyyjhhjhgjgj}
\frac{\partial\vec f_1}{\partial t}\left(\vec x,t\right)+d_{\vec
x}\vec f_1\left(\vec x,t\right)\cdot\vec u_0\left(\vec
x,t\right)+d_{\vec x}\vec f_0\left(\vec x,t\right)\cdot\vec
u_1\left(\vec x,t\right)\approx 0.
\end{equation}
Then, by \er{hjgghhghhffhkkmkljjnnnnghgghghjhyjyyjhhjhgjgj} and
\er{MaxVacFull1yiyiyninshtrgravortghhghgjkgghklhjghjhnkkghghjjkjhjkkggjkhjkhjjhhfhjhkjkhbbgjhzzzyyykkknnnNWBWHWPPNjjhjhjhjhplllhjggjl;kcc}
we deduce:
\begin{equation}\label{hjgghhghhffhkkmkljjnnnnghgghghjhyjyyjhhjhgjgjjhjhjhkhhkhjh}
\frac{\partial}{\partial t}\left(\left(d_{\vec x}\vec f_0\left(\vec
x,t\right)\right)^{-1}\cdot\vec f_1\left(\vec
x,t\right)\right)+d_{\vec x}\left(\left(d_{\vec x}\vec f_0\left(\vec
x,t\right)\right)^{-1}\cdot\vec f_1\left(\vec
x,t\right)\right)\cdot\vec u_0\left(\vec x,t\right)\approx-\vec
u_1\left(\vec x,t\right),
\end{equation}
i.e.
\begin{equation}\label{hjgghhghhffhkkmkljjnnnnghgghghjhyjyyjhhjhgjgjjhjhjh}
\frac{\partial\vec g_1}{\partial t}+d_{\vec x}\vec g_1\cdot\vec
u_0\approx-\vec u_1,
\end{equation}
where $\vec g_1(\vec x,t)$ is given by \er{jgjgghghghhkhkhkh}. Note
that, by
\er{MaxVacFull1yiyiyninshtrgravortghhghgjkgghklhjghjhnkkghghjjkjhjkkggjkhjkhjjhhfhjhkjkhbbgjhzzzyyykkknnnNWBWHWPPNjjhjhjhjhplllhjggjl;kcc},
\er{hjgghhghhffhkkmkljjnnnnghgghghjhyjyyjhhjhgjgjjhjhjh} can be
written in the alternative approximate form
\begin{equation}\label{hjgghhghhffhkkmkljjnnnnghgghghjhyjyyjhhjhgjgjjhjhjhkjhk}
\frac{\partial\vec g_1}{\partial t}+\left(\Div_{\vec x}\vec
u_0\right)\vec g_1+d_{\vec x}\vec g_1\cdot\vec u_0-d_{\vec x}\vec
u_0\cdot\vec g_1\approx-\vec u_1,
\end{equation}
and thus, by
\er{hjgghhghhffhkkmkljjnnnnghgghghjhyjyyjhhjhgjgjjhjhjhkjhk} and
\er{apfrm6} we have:
\begin{equation}\label{hjgghhghhffhkkmkljjnnnnghgghghjhyjyyjhhjhgjgjjhjhjhjhggjghjhnhmnmmbjoujhk}
\frac{\partial\vec g_1}{\partial t}-curl_{\vec x}\left\{\vec
u_0\times\vec g_1\right\}+\left(\Div_{\vec x}\vec g_1\right)\vec
u_0\approx-\vec u_1.
\end{equation}
On the other hand, as before, by
\er{MaxVacFull1ninshtrgravortghhghgjkgghklhjgkghghjjkjhjkkggjkhjkhjjhhfhjhkjkhbbgjhzzzyyykkknnnNWBWHWPPNjjhjhjhjhplllhjggjlcc}
and using
\er{MaxVacFull1yiyiyninshtrgravortghhghgjkgghklhjgkghghjjkjhjkkggjkhjkhjjhhfhjhkjkhbbgjhzzzyyykkknnnNWBWHWPPNjjhjhjhjhplllhjggjl;kcc}
we rewrite
\er{MaxVacFull1ninshtrgravortghhghgjkgghklhjgkghghjjkjhjkkggjkhjkhjjhhfhjhkjkhbbgjhzzzyyykkknnnNWBWHWPPNjjjljkjkkhhhhhhhhhhhh}
as:
\begin{multline}\label{MaxVacFull1ninshtrgravortghhghgjkgghklhjgkghghjjkjhjkkggjkhjkhjjhhfhjhkjkhbbgjhzzzyyykkknnnNWBWHWPPNjjjljkjkkhhhhhhhhhhhhhjjh}
\mu_0\left(\frac{\partial\vec u_0}{\partial t}+d_{\vec x}\vec
u_0\cdot \vec u_0\right)+\mu_1\left(\frac{\partial\vec u_0}{\partial
t}+d_{\vec x}\vec u_0\cdot \vec
u_0\right)+\mu_0\left(\frac{\partial\vec u_1}{\partial t}+d_{\vec
x}\vec u_1\cdot \vec u_0+d_{\vec x}\vec u_0\cdot \vec
u_1\right)\approx\\-\mu_0\vec u_0\times curl_{\vec x}\vec
v-\mu_1\vec u_0\times curl_{\vec x}\vec v-\mu_0\vec u_1\times
curl_{\vec x}\vec v+\mu_0\left(\partial_{t}\vec v+\nabla_{\vec
x}\frac{1}{2}|\vec v|^2\right)+\mu_1\left(\partial_{t}\vec
v+\nabla_{\vec x}\frac{1}{2}|\vec v|^2\right)\\+\rho\vec
E+\frac{1}{c}\vec j\times\vec B+\Div_{\vec
x}\left\{\alpha\mathcal{E}_0+\beta\left(tr\,\mathcal{E}_0\right)I\right\}+\Div_{\vec
x}\left\{\alpha\mathcal{E}_1+\beta\left(tr\,\mathcal{E}_1\right)I\right\}.
\end{multline}
Thus, as before, averaging
\er{MaxVacFull1ninshtrgravortghhghgjkgghklhjgkghghjjkjhjkkggjkhjkhjjhhfhjhkjkhbbgjhzzzyyykkknnnNWBWHWPPNjjjljkjkkhhhhhhhhhhhhhjjh}
gives
\begin{multline}\label{MaxVacFull1ninshtrgravortghhghgjkgghklhjgkghghjjkjhjkkggjkhjkhjjhhfhjhkjkhbbgjhzzzyyykkknnnNWBWHWPPNjjjljkjkkhhhhhhhhhhhhhjjhhkjhjh}
\mu_0\left(\frac{\partial\vec u_0}{\partial t}+d_{\vec x}\vec
u_0\cdot \vec u_0\right)\approx-\mu_0\vec u_0\times curl_{\vec
x}\vec v+\mu_0\left(\partial_{t}\vec v+\nabla_{\vec
x}\frac{1}{2}|\vec v|^2\right)+\rho\vec E+\frac{1}{c}\vec
j\times\vec B\\+\Div_{\vec
x}\left\{\alpha\mathcal{E}_0+\beta\left(tr\,\mathcal{E}_0\right)I\right\}.
\end{multline}
Therefore, subtracting
\er{MaxVacFull1ninshtrgravortghhghgjkgghklhjgkghghjjkjhjkkggjkhjkhjjhhfhjhkjkhbbgjhzzzyyykkknnnNWBWHWPPNjjjljkjkkhhhhhhhhhhhhhjjhhkjhjh}
from unaveraged
\er{MaxVacFull1ninshtrgravortghhghgjkgghklhjgkghghjjkjhjkkggjkhjkhjjhhfhjhkjkhbbgjhzzzyyykkknnnNWBWHWPPNjjjljkjkkhhhhhhhhhhhhhjjh}
we obtain
\begin{multline}\label{MaxVacFull1ninshtrgravortghhghgjkgghklhjgkghghjjkjhjkkggjkhjkhjjhhfhjhkjkhbbgjhzzzyyykkknnnNWBWHWPPNjjjljkjkkhhhhhhhhhhhhhjjhjhjggj}
\mu_1\left(\frac{\partial\vec u_0}{\partial t}+d_{\vec x}\vec
u_0\cdot \vec u_0\right)+\mu_0\left(\frac{\partial\vec u_1}{\partial
t}+d_{\vec x}\vec u_1\cdot \vec u_0+d_{\vec x}\vec u_0\cdot \vec
u_1\right)\approx\\-\mu_1\vec u_0\times curl_{\vec x}\vec
v-\mu_0\vec u_1\times curl_{\vec x}\vec
v+\mu_1\left(\partial_{t}\vec v+\nabla_{\vec x}\frac{1}{2}|\vec
v|^2\right)+\Div_{\vec
x}\left\{\alpha\mathcal{E}_1+\beta\left(tr\,\mathcal{E}_1\right)I\right\}.
\end{multline}
Thus, by
\er{MaxVacFull1ninshtrgravortghhghgjkgghklhjgkghghjjkjhjkkggjkhjkhjjhhfhjhkjkhbbgjhzzzyyykkknnnNWBWHWPPNjjjljkjkkhhhhhhhhhhhhhjjhjhjggj}
and
\er{MaxVacFull1ninshtrgravortghhghgjkgghklhjgkghghjjkjhjkkggjkhjkhjjhhfhjhkjkhbbgjhzzzyyykkknnnNWBWHWPPNjjjljkjkkhhhhhhhhhhhhhjjhhkjhjh}
we infer
\begin{multline}\label{MaxVacFull1ninshtrgravortghhghgjkgghklhjgkghghjjkjhjkkggjkhjkhjjhhfhjhkjkhbbgjhzzzyyykkknnnNWBWHWPPNjjjljkjkkhhhhhhhhhhhhhjjhjhjggjkjkhhk}
\mu_0\left(\frac{\partial\vec u_1}{\partial t}+d_{\vec x}\vec
u_1\cdot \vec u_0+d_{\vec x}\vec u_0\cdot \vec
u_1\right)\approx-\mu_0\vec u_1\times curl_{\vec x}\vec v+\Div_{\vec
x}\left\{\alpha\mathcal{E}_1+\beta\left(tr,\mathcal{E}_1\right)I\right\}\\-\frac{\mu_1}{\mu_0}\,\Div_{\vec
x}\left\{\alpha\mathcal{E}_0+\beta\left(tr\,\mathcal{E}_0\right)I\right\}-\frac{\mu_1}{\mu_0}\left(\rho\vec
E+\frac{1}{c}\vec j\times\vec B\right).
\end{multline}
Therefore, using
\er{MaxVacFull1yiyiyninshtrgravortghhghgjkgghklhjghjhnkkghghjjkjhjkkggjkhjkhjjhhfhjhkjkhbbgjhzzzyyykkknnnNWBWHWPPNjjhjhjhjhplllhjggjl;kcc}
we further approximate
\er{MaxVacFull1ninshtrgravortghhghgjkgghklhjgkghghjjkjhjkkggjkhjkhjjhhfhjhkjkhbbgjhzzzyyykkknnnNWBWHWPPNjjjljkjkkhhhhhhhhhhhhhjjhjhjggjkjkhhk}
as:
\begin{equation}\label{MaxVacFull1ninshtrgravortghhghgjkgghklhjgkghghjjkjhjkkggjkhjkhjjhhfhjhkjkhbbgjhzzzyyykkknnnNWBWHWPPNjjjljkjkkhhhhhhhhhhhhfnfffhuyuyuyutkhhkkh}
\mu_0\left(\frac{\partial\vec u_1}{\partial t}+d_{\vec x}\vec
u_1\cdot \vec u_0\right)\approx \Div_{\vec
x}\left\{\alpha\mathcal{E}_1+\beta\left(tr\,\mathcal{E}_1\right)I\right\}-\frac{\mu_1}{\mu_0}\,\Div_{\vec
x}\left\{\alpha\mathcal{E}_0+\beta\left(tr\,\mathcal{E}_0\right)I\right\}.
\end{equation}
On the other hand, by \er{hjhjjttuttutukjhkhlkkjkhjljj},
\er{MaxVacFull1ninshtrgravortghhghgjkgghklhjgkghghjjkjhjkkggjkhjkhjjhhfhjhkjkhbbgjhzzzyyykkknnnNWBWHWPPNjjhjhjhjhplllhjggjlcc},
\er{hjgjgjgkllujkjjjkhhjhljjkjkljjljjhhh} and
\er{MaxVacFull1yiyiyninshtrgravortghhghgjkgghklhjghjhnkkghghjjkjhjkkggjkhjkhjjhhfhjhkjkhbbgjhzzzyyykkknnnNWBWHWPPNjjhjhjhjhplllhjggjl;kcc}
we deduce
\begin{equation}\label{MaxVacFull1ninshtrgravortghhghgjkgghklhjgkghghjjkjhjkkggjkhjkhjjhhfhjhkjkhbbgjhzzzyyykkknnnNWBWHWPPNjjjljkjkkhhhhhhhhhhhhfnfffhuyuyuyutkhhkkhhkhhhh}
\left| \frac{\mu_1}{\mu_0}\,\Div_{\vec
x}\left\{\alpha\mathcal{E}_0+\beta\left(tr\,\mathcal{E}_0\right)I\right\}\right|\ll\left|\Div_{\vec
x}\left\{\alpha\mathcal{E}_1+\beta\left(tr\,\mathcal{E}_1\right)I\right\}\right|.
\end{equation}
Thus, by
\er{MaxVacFull1ninshtrgravortghhghgjkgghklhjgkghghjjkjhjkkggjkhjkhjjhhfhjhkjkhbbgjhzzzyyykkknnnNWBWHWPPNjjjljkjkkhhhhhhhhhhhhfnfffhuyuyuyutkhhkkhhkhhhh}
we further approximate
\er{MaxVacFull1ninshtrgravortghhghgjkgghklhjgkghghjjkjhjkkggjkhjkhjjhhfhjhkjkhbbgjhzzzyyykkknnnNWBWHWPPNjjjljkjkkhhhhhhhhhhhhfnfffhuyuyuyutkhhkkh}
as:
\begin{equation}\label{MaxVacFull1ninshtrgravortghhghgjkgghklhjgkghghjjkjhjkkggjkhjkhjjhhfhjhkjkhbbgjhzzzyyykkknnnNWBWHWPPNjjjljkjkkhhhhhhhhhhhhfnfffhuyuyuyut}
\mu_0\left(\frac{\partial\vec u_1}{\partial t}+d_{\vec x}\vec
u_1\cdot \vec u_0\right)\approx\Div_{\vec
x}\left\{\alpha\mathcal{E}_1+\beta\left(tr\,\mathcal{E}_1\right)I\right\}.
\end{equation}
Therefore, by inserting
\er{hjgghhghhffhkkmkljjnnnnghgghghjhyjyyjhhjhgjgjjhjhjh} into
\er{MaxVacFull1ninshtrgravortghhghgjkgghklhjgkghghjjkjhjkkggjkhjkhjjhhfhjhkjkhbbgjhzzzyyykkknnnNWBWHWPPNjjjljkjkkhhhhhhhhhhhhfnfffhuyuyuyut}
we deduce:
\begin{multline}\label{MaxVacFull1ninshtrgravortghhghgjkgghklhjgkghghjjkjhjkkggjkhjkhjjhhfhjhkjkhbbgjhzzzyyykkknnnNWBWHWPPNjjjljkjkkhhhhhhhhhhhhfnfffhuyuyuyutjhjhjh}
\mu_0\left(\frac{\partial}{\partial t}\left(\frac{\partial\vec
g_1}{\partial t}+d_{\vec x}\vec g_1\cdot\vec u_0\right)+d_{\vec
x}\left(\frac{\partial\vec g_1}{\partial t}+d_{\vec x}\vec
g_1\cdot\vec u_0\right)\cdot \vec u_0\right) \approx-\Div_{\vec
x}\left\{\alpha\mathcal{E}_1+\beta\left(tr\left\{\mathcal{E}_1\right\}\right)I\right\},
\end{multline}
where $\vec g_1(\vec x,t)$ is given by \er{jgjgghghghhkhkhkh}. On
the other hand, by \er{hjgjgjgkllujkjjjkhhjhljjkjkljjljjhhh},
\er{MaxVacFull1ninshtrgravortghhghgjkgghklhjgkghghjjkjhjkkggjkhjkhjjhhfhjhkjkhbbgjhzzzyyykkknnnNWBWHWPPNjjhjhjhjhplllhjggjlcc},
\er{MaxVacFull1yiyiyninshtrgravortghhghgjkgghklhjgkghghjjkjhjkkggjkhjkhjjhhfhjhkjkhbbgjhzzzyyykkknnnNWBWHWPPNjjhjhjhjhplllhjggjl;kcc}
and
\er{MaxVacFull1yiyiyninshtrgravortghhghgjkgghklhjghjhnkkghghjjkjhjkkggjkhjkhjjhhfhjhkjkhbbgjhzzzyyykkknnnNWBWHWPPNjjhjhjhjhplllhjggjl;kcc}
we have
\begin{equation}\label{hjgjgjgkllujkjjjkhhjhljjkjkljjljjhhhjkhhugg}
\mathcal{E}_0(\vec x,t)\approx\frac{1}{2}\left(I-\left\{d_{\vec
x}\vec f_0(\vec x,t)\right\}^T\cdot d_{\vec x}\vec f_0(\vec
x,t)\right),
\end{equation}
and
\begin{multline}\label{hjgjgjgkllujkjjjkhhjhljjkjkljjljjhkhjhhffgjjjljjlmbaa}
\mathcal{E}_{1}(\vec x,t)\approx-\frac{1}{2}\left(\left\{d_{\vec
x}\vec f_1(\vec x,t)\right\}^T\cdot d_{\vec x}\vec f_0(\vec
x,t)+\left\{d_{\vec x}\vec f_0(\vec x,t)\right\}^T\cdot d_{\vec
x}\vec f_1(\vec x,t)\right)\approx\\
-\frac{1}{2}\left\{d_{\vec x}\vec f_0(\vec x,t)\right\}^T\cdot
d_{\vec x}\vec f_0(\vec x,t)\cdot d_{\vec x}\left(\left(d_{\vec
x}\vec f_0(\vec x,t)\right)^{-1}\cdot\vec f_1(\vec
x,t)\right)\\-\frac{1}{2}\left\{ d_{\vec x}\left(\left(d_{\vec
x}\vec f_0(\vec x,t)\right)^{-1}\cdot\vec f_1(\vec
x,t)\right)\right\}^T\cdot\left\{d_{\vec x}\vec f_0(\vec
x,t)\right\}^{T}\cdot d_{\vec x}\vec f_0(\vec x,t)\\=
\mathcal{E}_0(\vec x,t)\cdot d_{\vec x}\vec g_1(\vec x,t)+\left\{
d_{\vec x}\vec g_1(\vec x,t)\right\}^T\cdot\mathcal{E}_0(\vec
x,t)-\frac{1}{2}\left(d_{\vec x}\vec g_1(\vec x,t)+\left\{ d_{\vec
x}\vec g_1(\vec x,t)\right\}^T\right),
\end{multline}
where the last equality of
\er{hjgjgjgkllujkjjjkhhjhljjkjkljjljjhkhjhhffgjjjljjlmbaa} follows
from \er{jgjgghghghhkhkhkh} and
\er{hjgjgjgkllujkjjjkhhjhljjkjkljjljjhhhjkhhugg}. So,
\begin{multline}\label{hjgjgjgkllujkjjjkhhjhljjkjkljjljjhkhjhhffgjjjljjlmb}
\mathcal{E}_{1}(\vec x,t)\approx \mathcal{E}_0(\vec x,t)\cdot
d_{\vec x}\vec g_1(\vec x,t)+\left\{ d_{\vec x}\vec g_1(\vec
x,t)\right\}^T\cdot\mathcal{E}_0(\vec x,t)-\frac{1}{2}\left(d_{\vec
x}\vec g_1(\vec x,t)+\left\{ d_{\vec x}\vec g_1(\vec
x,t)\right\}^T\right).
\end{multline}
Thus, inserting
\er{hjgjgjgkllujkjjjkhhjhljjkjkljjljjhkhjhhffgjjjljjlmb} into
\er{MaxVacFull1ninshtrgravortghhghgjkgghklhjgkghghjjkjhjkkggjkhjkhjjhhfhjhkjkhbbgjhzzzyyykkknnnNWBWHWPPNjjjljkjkkhhhhhhhhhhhhfnfffhuyuyuyutjhjhjh}
and using
\er{MaxVacFull1yiyiyninshtrgravortghhghgjkgghklhjghjhnkkghghjjkjhjkkggjkhjkhjjhhfhjhkjkhbbgjhzzzyyykkknnnNWBWHWPPNjjhjhjhjhplllhjggjl;kcc}
finally gives:
\begin{multline}\label{MaxVacFull1ninshtrgravortghhghgjkgghklhjgkghghjjkjhjkkggjkhjkhjjhhfhjhkjkhbbgjhzzzyyykkknnnNWBWHWPPNjjjljkjkkhhhhhhhhhhhhfnfffhuyuyuyutjhjhjhggjjfgfhfh}
\mu_0\left(\frac{\partial}{\partial t}\left(\frac{\partial\vec
g_1}{\partial t}+d_{\vec x}\vec g_1\cdot\vec u_0\right)+d_{\vec
x}\left(\frac{\partial\vec g_1}{\partial t}+d_{\vec x}\vec g_1\cdot\vec u_0\right)\cdot \vec u_0\right)\approx\\
-\alpha\,\Div_{\vec x}\left\{\mathcal{E}_0\cdot d_{\vec x}\vec
g_1\right\}-\alpha\,\Div_{\vec x}\left\{\left\{ d_{\vec x}\vec
g_1\right\}^T\cdot\mathcal{E}_0\right\}-\beta\,\nabla_{\vec
x}\left(tr\left(\mathcal{E}_0\cdot d_{\vec x}\vec
g_1\right)\right)-\beta\,\nabla_{\vec x}\left(tr\left(\left\{
d_{\vec x}\vec
g_1\right\}^T\cdot\mathcal{E}_0\right)\right)\\+\frac{\alpha}{2}\,\Delta_{\vec
x}\vec g_1+\left(\frac{\alpha}{2}+\beta\right)\nabla_{\vec
x}\left(\Div_{\vec x}\vec g_1\right).
\end{multline}
Note again that similarly as we proved
\er{hjgghhghhffhkkmkljjnnnnghgghghjhyjyyjhhjhgjgjjhjhjhjhggjghjhnhmnmmbjoujhk},
i.e.:
\begin{equation}\label{hjgghhghhffhkkmkljjnnnnghgghghjhyjyyjhhjhgjgjjhjhjhjhggjghjhnhmnmmbjoujhkhhh}
\frac{\partial\vec g_1}{\partial t}-curl_{\vec x}\left\{\vec
u_0\times\vec g_1\right\}+\left(\Div_{\vec x}\vec g_1\right)\vec
u_0\approx-\vec u_1.
\end{equation}
we can derive the approximation of
\er{MaxVacFull1ninshtrgravortghhghgjkgghklhjgkghghjjkjhjkkggjkhjkhjjhhfhjhkjkhbbgjhzzzyyykkknnnNWBWHWPPNjjjljkjkkhhhhhhhhhhhhfnfffhuyuyuyutjhjhjhggjjfgfhfh}
as
\begin{multline}\label{MaxVacFull1ninshtrgravortghhghgjkgghklhjgkghghjjkjhjkkggjkhjkhjjhhfhjhkjkhbbgjhzzzyyykkknnnNWBWHWPPNjjjljkjkkhhhhhhhhhhhhfnfffhuyuyuyutjhjhjhggjjfgfhfhkhhjhjh}
\mu_0\frac{\partial}{\partial t}\left(\frac{\partial\vec
g_1}{\partial t}-curl_{\vec x}\left\{\vec u_0\times\vec
g_1\right\}+\left(\Div_{\vec x}\vec g_1\right)\vec
u_0\right)\\-\mu_0\,curl_{\vec x}\left\{\vec
u_0\times\left(\frac{\partial\vec g_1}{\partial t}-curl_{\vec
x}\left\{\vec u_0\times\vec g_1\right\}+\left(\Div_{\vec x}\vec
g_1\right)\vec u_0\right)\right\}\\+\left(\Div_{\vec
x}\left\{\frac{\partial\vec g_1}{\partial t}-curl_{\vec
x}\left\{\vec u_0\times\vec g_1\right\}+\left(\Div_{\vec x}\vec
g_1\right)\vec u_0\right\}\right)\vec u_0
\approx\frac{\alpha}{2}\,\Delta_{\vec x}\vec
g_1+\left(\frac{\alpha}{2}+\beta\right)\nabla_{\vec
x}\left(\Div_{\vec x}\vec g_1\right)\\
-\alpha\,\Div_{\vec x}\left\{\mathcal{E}_0\cdot d_{\vec x}\vec
g_1\right\}-\alpha\,\Div_{\vec x}\left\{\left\{ d_{\vec x}\vec
g_1\right\}^T\cdot\mathcal{E}_0\right\}-\beta\,\nabla_{\vec
x}\left(tr\left(\mathcal{E}_0\cdot d_{\vec x}\vec
g_1\right)\right)-\beta\,\nabla_{\vec x}\left(tr\left(\left\{
d_{\vec x}\vec g_1\right\}^T\cdot\mathcal{E}_0\right)\right).
\end{multline}
In, particular, if our elastic body is \underline{nearly rigid} we
have
\begin{equation}\label{hjgjgjgkllujkjjjkhhjhljjkjkljjljjhhhjkhhugghjjhjhgj}
\mathcal{E}_0\approx 0,
\end{equation}
and we simplify
\er{MaxVacFull1ninshtrgravortghhghgjkgghklhjgkghghjjkjhjkkggjkhjkhjjhhfhjhkjkhbbgjhzzzyyykkknnnNWBWHWPPNjjjljkjkkhhhhhhhhhhhhfnfffhuyuyuyutjhjhjhggjjfgfhfh}
as:
\begin{multline}\label{MaxVacFull1ninshtrgravortghhghgjkgghklhjgkghghjjkjhjkkggjkhjkhjjhhfhjhkjkhbbgjhzzzyyykkknnnNWBWHWPPNjjjljkjkkhhhhhhhhhhhhfnfffhuyuyuyutjhjhjhggjjfgfhfhjh}
\mu_0\left(\frac{\partial}{\partial t}\left(\frac{\partial\vec
g_1}{\partial t}+d_{\vec x}\vec g_1\cdot\vec u_0\right)+d_{\vec
x}\left(\frac{\partial\vec g_1}{\partial t}+d_{\vec x}\vec
g_1\cdot\vec u_0\right)\cdot \vec u_0\right)
\approx\frac{\alpha}{2}\,\Delta_{\vec x}\vec
g_1+\left(\frac{\alpha}{2}+\beta\right)\nabla_{\vec
x}\left(\Div_{\vec x}\vec g_1\right),
\end{multline}
and \er{hjgjgjgkllujkjjjkhhjhljjkjkljjljjhkhjhhffgjjjljjlmb} as:
\begin{equation}\label{hjgjgjgkllujkjjjkhhjhljjkjkljjljjhkhjhhffgjjjljjlmbhkhkhkhk}
\mathcal{E}_{1}\approx-\frac{1}{2}\left(d_{\vec x}\vec g_1+\left\{
d_{\vec x}\vec g_1\right\}^T\right).
\end{equation}
Moreover, we rewrite
\er{MaxVacFull1ninshtrgravortghhghgjkgghklhjgkghghjjkjhjkkggjkhjkhjjhhfhjhkjkhbbgjhzzzyyykkknnnNWBWHWPPNjjjljkjkkhhhhhhhhhhhhfnfffhuyuyuyutjhjhjhggjjfgfhfhkhhjhjh}
as:
\begin{multline}\label{MaxVacFull1ninshtrgravortghhghgjkgghklhjgkghghjjkjhjkkggjkhjkhjjhhfhjhkjkhbbgjhzzzyyykkknnnNWBWHWPPNjjjljkjkkhhhhhhhhhhhhfnfffhuyuyuyutjhjhjhggjjfgfhfhjhhgjgj}
\mu_0\frac{\partial}{\partial t}\left(\frac{\partial\vec
g_1}{\partial t}-curl_{\vec x}\left\{\vec u_0\times\vec
g_1\right\}+\left(\Div_{\vec x}\vec g_1\right)\vec
u_0\right)\\-\mu_0\,curl_{\vec x}\left\{\vec
u_0\times\left(\frac{\partial\vec g_1}{\partial t}-curl_{\vec
x}\left\{\vec u_0\times\vec g_1\right\}+\left(\Div_{\vec x}\vec
g_1\right)\vec u_0\right)\right\}\\+\mu_0\left(\Div_{\vec
x}\left\{\frac{\partial\vec g_1}{\partial t}-curl_{\vec
x}\left\{\vec u_0\times\vec g_1\right\}+\left(\Div_{\vec x}\vec
g_1\right)\vec u_0\right\}\right)\vec u_0
\approx\frac{\alpha}{2}\,\Delta_{\vec x}\vec
g_1+\left(\frac{\alpha}{2}+\beta\right)\nabla_{\vec
x}\left(\Div_{\vec x}\vec g_1\right).
\end{multline}
Next, note that, since $\mu,\mu_0,\mu_1$ are \underline{proper
scalar} fields, $\mathcal{E},\mathcal{E}_0,\mathcal{E}_1$ are
\underline{proper matrix} fields, $\vec u,\vec u_0$ are
\underline{speed-like vector} fields, $\vec u_1$ is a
\underline{proper vector} field and the vector field $\vec
g_1\left(\vec x,t\right)$ defined by \er{jgjgghghghhkhkhkh} is a
\underline{proper vector} field, as before, it can be easily proved
that
\er{MaxVacFull1ninshtrgravortghhghgjkgghklhjgkghghjjkjhjkkggjkhjkhjjhhfhjhkjkhbbgjhzzzyyykkknnnNWBWHWPPNjjjljkjkkhhhhhhhhhhhhfnfffhuyuyuyutjhjhjhggjjfgfhfhkhhjhjh},
\er{hjgghhghhffhkkmkljjnnnnghgghghjhyjyyjhhjhgjgjjhjhjhjhggjghjhnhmnmmbjoujhkhhh},
\er{hjgjgjgkllujkjjjkhhjhljjkjkljjljjhhhjkhhugg},
\er{hjgjgjgkllujkjjjkhhjhljjkjkljjljjhkhjhhffgjjjljjlmb},
\er{MaxVacFull1ninshtrgravortghhghgjkgghklhjgkghghjjkjhjkkggjkhjkhjjhhfhjhkjkhbbgjhzzzyyykkknnnNWBWHWPPNjjjljkjkkhhhhhhhhhhhhfnfffhuyuyuyutjhjhjhggjjfgfhfhjhhgjgj}
\er{hjgjgjgkllujkjjjkhhjhljjkjkljjljjhhhjkhhugghjjhjhgj} and
\er{hjgjgjgkllujkjjjkhhjhljjkjkljjljjhkhjhhffgjjjljjlmbhkhkhkhk} are
invariant under the change of inertial or non-inertial cartesian
coordinate system. Thus,
\er{MaxVacFull1ninshtrgravortghhghgjkgghklhjgkghghjjkjhjkkggjkhjkhjjhhfhjhkjkhbbgjhzzzyyykkknnnNWBWHWPPNjjjljkjkkhhhhhhhhhhhhfnfffhuyuyuyutjhjhjhggjjfgfhfhkhhjhjh},
\er{hjgghhghhffhkkmkljjnnnnghgghghjhyjyyjhhjhgjgjjhjhjhjhggjghjhnhmnmmbjoujhkhhh},
\er{hjgjgjgkllujkjjjkhhjhljjkjkljjljjhhhjkhhugg},
\er{hjgjgjgkllujkjjjkhhjhljjkjkljjljjhkhjhhffgjjjljjlmb} and in the
case of \underline{nearly rigid} elastic body also
\er{MaxVacFull1ninshtrgravortghhghgjkgghklhjgkghghjjkjhjkkggjkhjkhjjhhfhjhkjkhbbgjhzzzyyykkknnnNWBWHWPPNjjjljkjkkhhhhhhhhhhhhfnfffhuyuyuyutjhjhjhggjjfgfhfhjhhgjgj}
\er{hjgjgjgkllujkjjjkhhjhljjkjkljjljjhhhjkhhugghjjhjhgj} and
\er{hjgjgjgkllujkjjjkhhjhljjkjkljjljjhkhjhhffgjjjljjlmbhkhkhkhk} are
still valid if
\er{MaxVacFull1yiyiyninshtrgravortghhghgjkgghklhjgkghghjjkjhjkkggjkhjkhjjhhfhjhkjkhbbgjhzzzyyykkknnnNWBWHWPPNjjhjhjhjhplllhjggjl;kcc}
and
\er{MaxVacFull1yiyiyninshtrgravortghhghgjkgghklhjghjhnkkghghjjkjhjkkggjkhjkhjjhhfhjhkjkhbbgjhzzzyyykkknnnNWBWHWPPNjjhjhjhjhplllhjggjl;kcc}
hold only in some specific (possibly artificial) inertial or
non-inertial cartesian coordinate system.

Next, taking the divergence of both sides of
\er{MaxVacFull1ninshtrgravortghhghgjkgghklhjgkghghjjkjhjkkggjkhjkhjjhhfhjhkjkhbbgjhzzzyyykkknnnNWBWHWPPNjjjljkjkkhhhhhhhhhhhhfnfffhuyuyuyutjhjhjhggjjfgfhfhjhhgjgj}
and using
\er{MaxVacFull1yiyiyninshtrgravortghhghgjkgghklhjghjhnkkghghjjkjhjkkggjkhjkhjjhhfhjhkjkhbbgjhzzzyyykkknnnNWBWHWPPNjjhjhjhjhplllhjggjl;kcc}
gives:
\begin{multline}\label{MaxVacFull1ninshtrgravortghhghgjkgghklhjgkghghjjkjhjkkggjkhjkhjjhhfhjhkjkhbbgjhzzzyyykkknnnNWBWHWPPNjjjljkjkkhhhhhhhhhhhhfnfffhuyuyuyutjhjhjhggjjfgfhfhjhhgjgjhkjhjh}
\frac{\partial}{\partial
t}\left\{\mu_0\left(\frac{\partial}{\partial t}\left(\Div_{\vec
x}\vec g_1\right)+\vec u_0\cdot\nabla_{\vec x}\left(\Div_{\vec
x}\vec g_1\right)\right)\right\}+\Div_{\vec
x}\left\{\mu_0\left(\frac{\partial}{\partial t}\left(\Div_{\vec
x}\vec g_1\right)+\vec u_0\cdot\nabla_{\vec x}\left(\Div_{\vec
x}\vec
g_1\right)\right)\vec u_0\right\}\\
\approx\left(\alpha+\beta\right)\Delta_{\vec x}\left(\Div_{\vec
x}\vec g_1\right),
\end{multline}
i.e.:
\begin{equation}\label{MaxVacFull1ninshtrgravortghhghgjkgghklhjgkghghjjkjhjkkggjkhjkhjjhhfhjhkjkhbbgjhzzzyyykkknnnNWBWHWPPNjjjljkjkkhhhhhhhhhhhhfnfffhuyuyuyutjhjhjhggjjfgfhfhjhhgjgjhkjhjhkhkh}
\frac{\partial}{\partial
t}\left\{\frac{1}{c^2_{01}}\left(\frac{\partial h_1}{\partial
t}+\vec u_0\cdot\nabla_{\vec x} h_1\right)\right\}+\Div_{\vec
x}\left\{\frac{1}{c^2_{01}}\left(\frac{\partial h_1}{\partial
t}+\vec u_0\cdot\nabla_{\vec x} h_1\right)\vec u_0\right\}
\approx\Delta_{\vec x} h_1,
\end{equation}
where
\begin{equation}\label{hjgjgjgkllujkjjjkhhjhljjkjkljjljjhhhjkhhugghjjhjhgjkhhhkh}
h_1(\vec x,t):=\Div_{\vec x}\vec g_1(\vec
x,t)\quad\quad\text{and}\quad\quad
c_{01}:=\sqrt{\frac{\left(\alpha+\beta\right)}{\mu_0}}.
\end{equation}
On the other hand, taking the curl of both sides of
\er{MaxVacFull1ninshtrgravortghhghgjkgghklhjgkghghjjkjhjkkggjkhjkhjjhhfhjhkjkhbbgjhzzzyyykkknnnNWBWHWPPNjjjljkjkkhhhhhhhhhhhhfnfffhuyuyuyutjhjhjhggjjfgfhfhjh}
and using
\er{MaxVacFull1yiyiyninshtrgravortghhghgjkgghklhjghjhnkkghghjjkjhjkkggjkhjkhjjhhfhjhkjkhbbgjhzzzyyykkknnnNWBWHWPPNjjhjhjhjhplllhjggjl;kcc}
gives:
\begin{multline}\label{MaxVacFull1ninshtrgravortghhghgjkgghklhjgkghghjjkjhjkkggjkhjkhjjhhfhjhkjkhbbgjhzzzyyykkknnnNWBWHWPPNjjjljkjkkhhhhhhhhhhhhfnfffhuyuyuyutjhjhjhggjjfgfhfhjhgjg}
\frac{\partial}{\partial
t}\left\{\mu_0\left(\frac{\partial}{\partial t}\left(curl_{\vec
x}\,\vec g_1\right)+d_{\vec x}\left(curl_{\vec x}\,\vec
g_1\right)\cdot\vec u_0\right)\right\}+\Div_{\vec
x}\left\{\mu_0\left(\frac{\partial}{\partial t}\left(curl_{\vec
x}\,\vec g_1\right)+d_{\vec x}\left(curl_{\vec x}\,\vec
g_1\right)\cdot\vec u_0\right)\otimes\vec u_0\right\}\\
\approx\frac{\alpha}{2}\,\Delta_{\vec x}\left(curl_{\vec x}\,\vec
g_1\right),
\end{multline}
i.e.
\begin{equation}\label{MaxVacFull1ninshtrgravortghhghgjkgghklhjgkghghjjkjhjkkggjkhjkhjjhhfhjhkjkhbbgjhzzzyyykkknnnNWBWHWPPNjjjljkjkkhhhhhhhhhhhhfnfffhuyuyuyutjhjhjhggjjfgfhfhjhgjgjkhkhgnnjkh}
\frac{\partial}{\partial
t}\left\{\frac{1}{c^2_{02}}\left(\frac{\partial\vec h_2}{\partial
t}+d_{\vec x}\vec h_2\cdot\vec u_0\right)\right\}+\Div_{\vec
x}\left\{\frac{1}{c^2_{02}}\left(\frac{\partial\vec h_2}{\partial
t}+d_{\vec x}\vec h_2\cdot\vec u_0\right)\otimes\vec
u_0\right\}\approx\Delta_{\vec x}\vec h_2,
\end{equation}
where
\begin{equation}\label{hjgjgjgkllujkjjjkhhjhljjkjkljjljjhhhjkhhugghjjhjhgjkhhhkhhjjghjkhjh}
\vec h_2(\vec x,t):=curl_{\vec x}\,\vec g_1(\vec
x,t)\quad\quad\text{and}\quad\quad
c_{02}:=\sqrt{\frac{\alpha}{2\mu_0}}.
\end{equation}
Again, using
\er{MaxVacFull1yiyiyninshtrgravortghhghgjkgghklhjghjhnkkghghjjkjhjkkggjkhjkhjjhhfhjhkjkhbbgjhzzzyyykkknnnNWBWHWPPNjjhjhjhjhplllhjggjl;kcc}
we can rewrite
\er{MaxVacFull1ninshtrgravortghhghgjkgghklhjgkghghjjkjhjkkggjkhjkhjjhhfhjhkjkhbbgjhzzzyyykkknnnNWBWHWPPNjjjljkjkkhhhhhhhhhhhhfnfffhuyuyuyutjhjhjhggjjfgfhfhjhgjgjkhkhgnnjkh}
as:
\begin{multline}\label{MaxVacFullPPNnnnffffffyuughjhjhjhhjjkjhkkjhujgGGggff}
\frac{1}{c^2_{02}}\frac{\partial}{\partial
t}\left(\frac{\partial\vec h_2}{\partial t}-\vec { u}_0\times
curl_{\vec x}\vec h_2+\nabla_{\vec x}\left(\vec h_2\cdot\vec {
u}_0\right)\right)\\-\frac{1}{c^2_{02}}curl_{\vec x} \left\{\vec
{u}_0\times
\left(\frac{\partial\vec h_2}{\partial t}-\vec {u}_0\times
curl_{\vec x}\vec h_2+\nabla_{\vec x}\left(\vec h_2\cdot\vec {
u}_0\right)\right)\right\}\\+\frac{1}{c^2_{02}}\left(\Div_{\vec
x}\left(\frac{\partial\vec h_2}{\partial t}-\vec {u}_0\times
curl_{\vec x}\vec h_2+\nabla_{\vec x}\left(\vec h_2\cdot\vec {
u}_0\right)\right)\right)\vec {u}_0\approx \Delta_{\vec x}\vec h_2.
\end{multline}
Again note that, using Proposition \ref{yghgjtgyrtrt} we deduce that
equations
\er{MaxVacFull1ninshtrgravortghhghgjkgghklhjgkghghjjkjhjkkggjkhjkhjjhhfhjhkjkhbbgjhzzzyyykkknnnNWBWHWPPNjjjljkjkkhhhhhhhhhhhhfnfffhuyuyuyutjhjhjhggjjfgfhfhjhhgjgjhkjhjhkhkh}
and \er{MaxVacFullPPNnnnffffffyuughjhjhjhhjjkjhkkjhujgGGggff} are
invariant under the change of inertial or non-inertial cartesian
coordinate system, provided that $\vec h_2$ is a proper vector field
and $h_1$ is a proper scalar field. Thus, in the case of
\underline{nearly rigid} elastic body
\er{MaxVacFull1ninshtrgravortghhghgjkgghklhjgkghghjjkjhjkkggjkhjkhjjhhfhjhkjkhbbgjhzzzyyykkknnnNWBWHWPPNjjjljkjkkhhhhhhhhhhhhfnfffhuyuyuyutjhjhjhggjjfgfhfhjhhgjgjhkjhjhkhkh}
and \er{MaxVacFullPPNnnnffffffyuughjhjhjhhjjkjhkkjhujgGGggff} are
still valid if
\er{MaxVacFull1yiyiyninshtrgravortghhghgjkgghklhjgkghghjjkjhjkkggjkhjkhjjhhfhjhkjkhbbgjhzzzyyykkknnnNWBWHWPPNjjhjhjhjhplllhjggjl;kcc}
and
\er{MaxVacFull1yiyiyninshtrgravortghhghgjkgghklhjghjhnkkghghjjkjhjkkggjkhjkhjjhhfhjhkjkhbbgjhzzzyyykkknnnNWBWHWPPNjjhjhjhjhplllhjggjl;kcc}
hold only in some specific (possibly artificial) inertial or
non-inertial cartesian coordinate system. Finally note that,
although the equation
\er{MaxVacFull1ninshtrgravortghhghgjkgghklhjgkghghjjkjhjkkggjkhjkhjjhhfhjhkjkhbbgjhzzzyyykkknnnNWBWHWPPNjjjljkjkkhhhhhhhhhhhhfnfffhuyuyuyutjhjhjhggjjfgfhfhjhgjgjkhkhgnnjkh}
is invariant under the Galilean Transformation, it is not invariant
under the more general change of non-inertial cartesian coordinate
system. However,
\er{MaxVacFull1ninshtrgravortghhghgjkgghklhjgkghghjjkjhjkkggjkhjkhjjhhfhjhkjkhbbgjhzzzyyykkknnnNWBWHWPPNjjjljkjkkhhhhhhhhhhhhfnfffhuyuyuyutjhjhjhggjjfgfhfhjhgjgjkhkhgnnjkh}
is more convenient than
\er{MaxVacFullPPNnnnffffffyuughjhjhjhhjjkjhkkjhujgGGggff}, since
every of the three scalar components of the vector field $\vec h_2$
in
\er{MaxVacFull1ninshtrgravortghhghgjkgghklhjgkghghjjkjhjkkggjkhjkhjjhhfhjhkjkhbbgjhzzzyyykkknnnNWBWHWPPNjjjljkjkkhhhhhhhhhhhhfnfffhuyuyuyutjhjhjhggjjfgfhfhjhgjgjkhkhgnnjkh}
satisfies three decoupled wave equations of the same type. On the
other hand, if we consider some three proper vector fields $\vec
e_1:=\vec e_1(\vec x,t)$, $\vec e_2:=\vec e_2(\vec x,t)$, and $\vec
e_3:=\vec e_3(\vec x,t)$, which are mutually orthogonal to each
other and satisfy the following approximation analogous to
\er{MaxVacFull1yiyiyninshtrgravortghhghgjkgghklhjghjhnkkghghjjkjhjkkggjkhjkhjjhhfhjhkjkhbbgjhzzzyyykkknnnNWBWHWPPNjjhjhjhjhplllhjggjl;kcc}:
\begin{equation}\label{MaxVacFullPPNmmmffffffhhtygghGGGGyuhggghghhjhjhff}
\frac{|d_{\vec x}\vec {e}_1|+c_{02}|\partial_t\vec e_1|}{|\vec
{e}_1|}+\frac{|d_{\vec x}\vec {e}_2|+c_{02}|\partial_t\vec
e_2|}{|\vec {e}_2|}+\frac{|d_{\vec x}\vec
{e}_3|+c_{02}|\partial_t\vec e_3|}{|\vec {e}_3|}\ll\frac{|d_{\vec
x}\vec h_2|+c_{02}|\partial_t\vec h_2|}{|\vec h_2|}.
\end{equation}
i.e. the field $\vec {e}_k$ vary in space and time much weaker than
$\vec h_2$, then we may write the alternative to
\er{MaxVacFull1ninshtrgravortghhghgjkgghklhjgkghghjjkjhjkkggjkhjkhjjhhfhjhkjkhbbgjhzzzyyykkknnnNWBWHWPPNjjjljkjkkhhhhhhhhhhhhfnfffhuyuyuyutjhjhjhggjjfgfhfhjhgjgjkhkhgnnjkh}
approximate equations in the form of three decoupled scalar wave
equations of the same type:
\begin{multline}\label{MaxVacFullPPNmmmffffffiuiuhjuGGggFGjjkkjjff}
\frac{\partial}{\partial
t}\left\{\frac{1}{c^2_{02}}\left(\frac{\partial}{\partial
t}\left(\vec e_k\cdot\vec h_2\right)+\vec {u}_0\cdot\nabla_{\vec
x}\left(\vec e_k\cdot\vec h_2\right)\right)\right\}+\Div_{\vec x}
\left\{\frac{1}{c^2_{02}}\left(\frac{\partial}{\partial t}\left(\vec
e_k\cdot\vec h_2\right)+\vec {u}_0\cdot\nabla_{\vec x}\left(\vec
e_k\cdot\vec h_2\right)\right)\vec {u}_0\right\}\\ \approx
\Delta_{\vec x}\left(\vec e_k\cdot\vec h_2\right)\quad\quad\forall
k=1,2,3.
\end{multline}
Then, clearly, the new alternative approximate equations
\er{MaxVacFullPPNmmmffffffiuiuhjuGGggFGjjkkjjff} together with
\er{MaxVacFull1ninshtrgravortghhghgjkgghklhjgkghghjjkjhjkkggjkhjkhjjhhfhjhkjkhbbgjhzzzyyykkknnnNWBWHWPPNjjjljkjkkhhhhhhhhhhhhfnfffhuyuyuyutjhjhjhggjjfgfhfhjhhgjgjhkjhjhkhkh}
are indeed invariant under the more general change of non-inertial
cartesian coordinate system.

As a final corollary of the above in the case of nearly rigid body
we have \er{MaxVacFullPPNmmmffffffiuiuhjuGGggFGjjkkjjff} and
\er{MaxVacFull1ninshtrgravortghhghgjkgghklhjgkghghjjkjhjkkggjkhjkhjjhhfhjhkjkhbbgjhzzzyyykkknnnNWBWHWPPNjjjljkjkkhhhhhhhhhhhhfnfffhuyuyuyutjhjhjhggjjfgfhfhjhhgjgjhkjhjhkhkh},
i.e.:
\begin{equation}\label{MaxVacFull1ninshtrgravortghhghgjkgghklhjgkghghjjkjhjkkggjkhjkhjjhhfhjhkjkhbbgjhzzzyyykkknnnNWBWHWPPNjjjljkjkkhhhhhhhhhhhhfnfffhuyuyuyutjhjhjhggjjfgfhfhjhhgjgjhkjhjhkhkhjj}
\frac{\partial}{\partial
t}\left\{\frac{1}{c^2_{01}}\left(\frac{\partial h_1}{\partial
t}+\vec u_0\cdot\nabla_{\vec x} h_1\right)\right\}+\Div_{\vec
x}\left\{\frac{1}{c^2_{01}}\left(\frac{\partial h_1}{\partial
t}+\vec u_0\cdot\nabla_{\vec x} h_1\right)\vec u_0\right\}
\approx\Delta_{\vec x} h_1,
\end{equation}
and
\begin{multline}\label{MaxVacFullPPNmmmffffffiuiuhjuGGggFGjjkkjjffjj}
\frac{\partial}{\partial
t}\left\{\frac{1}{c^2_{02}}\left(\frac{\partial}{\partial
t}\left(\vec e_k\cdot\vec h_2\right)+\vec {u}_0\cdot\nabla_{\vec
x}\left(\vec e_k\cdot\vec h_2\right)\right)\right\}+\Div_{\vec x}
\left\{\frac{1}{c^2_{02}}\left(\frac{\partial}{\partial t}\left(\vec
e_k\cdot\vec h_2\right)+\vec {u}_0\cdot\nabla_{\vec x}\left(\vec
e_k\cdot\vec h_2\right)\right)\vec {u}_0\right\}\\ \approx
\Delta_{\vec x}\left(\vec e_k\cdot\vec h_2\right)\quad\quad\forall
k=1,2,3,
\end{multline}
where
\begin{equation}\label{hjgjgjgkllujkjjjkhhjhljjkjkljjljjhhhjkhhugghjjhjhgjkhhhkhjkhhkhkkh}
h_1(\vec x,t):=\Div_{\vec x}\vec g_1(\vec
x,t)\quad\quad\text{and}\quad\quad
c_{01}:=\sqrt{\frac{\left(\alpha+\beta\right)}{\mu_0}}.
\end{equation}
and
\begin{equation}\label{hjgjgjgkllujkjjjkhhjhljjkjkljjljjhhhjkhhugghjjhjhgjkhhhkhhjjghjkhjhjkjkjkk}
\vec h_2(\vec x,t):=curl_{\vec x}\,\vec g_1(\vec
x,t)\quad\quad\text{and}\quad\quad
c_{02}:=\sqrt{\frac{\alpha}{2\mu_0}}.
\end{equation}
Moreover, we have
\er{hjgjgjgkllujkjjjkhhjhljjkjkljjljjhkhjhhffgjjjljjlmbhkhkhkhk},
i.e.:
\begin{equation}\label{hjgjgjgkllujkjjjkhhjhljjkjkljjljjhkhjhhffgjjjljjlmbhkhkhkhkjkkhhh}
\mathcal{E}_{1}\approx-\frac{1}{2}\left(d_{\vec x}\vec g_1+\left\{
d_{\vec x}\vec g_1\right\}^T\right),
\end{equation}
and
\er{hjgghhghhffhkkmkljjnnnnghgghghjhyjyyjhhjhgjgjjhjhjhjhggjghjhnhmnmmbjoujhkhhh},
i.e.:
\begin{equation}\label{hjgghhghhffhkkmkljjnnnnghgghghjhyjyyjhhjhgjgjjhjhjhjhggjghjhnhmnmmbjoujhkggghgh}
\frac{\partial\vec g_1}{\partial t}-curl_{\vec x}\left\{\vec
u_0\times\vec g_1\right\}+\left(\Div_{\vec x}\vec g_1\right)\vec
u_0\approx-\vec u_1,
\end{equation}
and the above six equations are invariant under the change of
inertial or non-inertial cartesian coordinate system.

As we can easily see, the divergence of $\vec g_1$ and any of the
scalar components of the curl of $\vec g_1$ satisfy the invariant
wave equation of the same type as
\er{MaxVacFull1ninshtrgravortghhghgjkgghklhjgkghghjjkjhjkkggjkhjkhjjhhfhjhkjhjjhkhbbgjlmkmmhzzzyyykkknnnNWBWHWPPNjjlllkkkkkkljkkjjkjkhjjhhgghgggghjjhgjggh}
or
\er{MaxVacFull1ninshtrgravortghhghgjkgghklhjgkghghjjkjhjkkggjkhjkhjjhhfhjhkjhjjhkhbbgjlmkmmhzzzyyykkknnnNWBWHWPPNjjlllkkkkkkljkkjjkjkhjjhhgghgggghjjhgjgghjkjkjk}.
However, as we can see from
\er{hjgjgjgkllujkjjjkhhjhljjkjkljjljjhhhjkhhugghjjhjhgjkhhhkhjkhhkhkkh}
and
\er{hjgjgjgkllujkjjjkhhjhljjkjkljjljjhhhjkhhugghjjhjhgjkhhhkhhjjghjkhjhjkjkjkk}
the characteristic parameter $c_{01}$ in the wave equation for the
divergence part of $\vec g_1$
\er{MaxVacFull1ninshtrgravortghhghgjkgghklhjgkghghjjkjhjkkggjkhjkhjjhhfhjhkjkhbbgjhzzzyyykkknnnNWBWHWPPNjjjljkjkkhhhhhhhhhhhhfnfffhuyuyuyutjhjhjhggjjfgfhfhjhhgjgjhkjhjhkhkhjj}
differ from the characteristic parameter $c_{02}$ in the wave
equation for the curl part of $\vec g_1$
\er{MaxVacFullPPNmmmffffffiuiuhjuGGggFGjjkkjjffjj}, i.e. the
divergence part and the curl parts of  $\vec g_1$ propagate as two
different waves with the different speeds. As it can be easily seen,
in the case of the flat waves in the resting body the divergence
part of $\vec g_1$ (which is curl-free) propagate as a
\underline{longitudinal} wave, similarly to the sound in fluid or
gas, and at the same time the curl part of $\vec g_1$ (which is
divergence-free) propagate as a \underline{transverse} wave.
Moreover, the longitudinal and transverse waves in an elastic body
propagate with two different speeds.

\section{Maxwell equations in the presence of Dielectrics and/or Magnetics}\label{DMPGG}

\subsection{Polarization and Magnetization of a totally neutral
system of point charges}\label{huugftdtdtddtdt}
\begin{proposition}\label{hjghfhfhf11}
Consider an arbitrary moving point with place $\vec r(t)$ and
velocity $\frac{d\vec r}{dt}(t)$ and let $\vec u(\vec x,t)$ be a
speed like vector field. Next consider a totally neutral system of
$n$ point charges $\sigma_1,\ldots,\sigma_n$, with places $\vec
r_1(t),\ldots,\vec r_n(t)$ and velocities $\frac{d\vec
r_1}{dt}(t),\ldots, \frac{d\vec r_n}{dt}(t)$ satisfying
\begin{equation}\label{hgkgjhfhjfjfiouykkhhhjjgjgkhkhkjuj11yyttt11}
\sum_{k=1}^{n}\sigma_k=0.
\end{equation}
Then, denoting, as usual, the charge and the current densities as:
\begin{equation}\label{hgkgjhfhjfjfiouykkhhhjhjhjhjhh11ghghghf11kljjkkkhhuuug11}
\rho(\vec x,t):=\sum_{k=1}^{n}\sigma_k\delta\left(\vec x-\vec
r_k(t)\right),\quad\text{and}\quad \vec j(\vec x,t)
:=\sum_{k=1}^{n}\sigma_k\frac{d\vec r_k}{dt}(t)\,\delta\left(\vec
x-\vec r_k(t)\right),
\end{equation}
we have
\begin{equation}\label{hgkgjhfhjfjfiouykkhhhjhjhjhjhhkjljljl;lk;lk;l;k;utityi11}
\begin{cases}
\rho(\vec x,t)=-div_{\vec x}\vec P(\vec x,t)\\
\vec j(x,t)= \frac{\partial\vec P}{\partial t}(\vec x,t)-curl_{\vec
x}\left\{\vec u\left(\vec x,t\right)\times\vec P(\vec
x,t)\right\}+c\,curl_{\vec x}\vec M_{\vec u}(\vec x,t),
\end{cases}
\end{equation}
or equivalently
\begin{equation}\label{hgkgjhfhjfjfiouykkhhhjhjhjhjhhkjljljl;lk;lk;l;k;utityi11jijij}
\begin{cases}
\rho(\vec x,t)=-div_{\vec x}\vec P(\vec x,t)\\
\vec j(x,t)= \frac{\partial\vec P}{\partial t}(\vec
x,t)+c\,curl_{\vec x}\vec M(\vec x,t),
\end{cases}
\end{equation}
where $\vec P(\vec x,t)$, $\vec M(\vec x,t)$ and $\vec M_{\vec
u}(\vec x,t)$ are the Polarization and the Magnetization of the
given neutral system defined as:
\begin{multline}\label{hgkgjhfhjfjfiouykkhhhjjgjgkhkhkjujuyfffffffffffffffhjjhjhj11}
\vec P(\vec x,t):=\sum_{k=1}^{n}\sigma_k\vec
l_k(t)\int_0^1\,\delta\left(\vec x-\left(\vec r(t)+s\vec
l_k(t)\right)\right)ds\,,\\
\vec M(\vec
x,t):=\\-\frac{1}{c}\left(\sum_{k=1}^{n}\sigma_k\int_0^1\left(\left(\frac{d\vec
r}{dt}(t)+s\frac{d\vec l_k}{dt}(t)\right)\times\vec
l_k(t)\,\delta\left(\vec
x-\left(\vec r(t)+s\vec l_k(t)\right)\right)\right)ds\right)\\
\quad\text{and}\quad\quad\quad\vec M_{\vec u}(\vec x,t):=\vec M(\vec
x,t)+\frac{1}{c}\vec u(\vec x,t)\times \vec P(\vec
x,t)=\\-\frac{1}{c}\left(\sum_{k=1}^{n}\sigma_k\int_0^1\left(\left(\left(\frac{d\vec
r}{dt}(t)+s\frac{d\vec l_k}{dt}(t)\right)-\vec u\left(\vec
r(t)+s\vec l_k(t),t\right)\right)\times\vec l_k(t)\,\delta\left(\vec
x-\left(\vec r(t)+s\vec l_k(t)\right)\right)\right)ds\right),
\end{multline}
where we denoted
\begin{equation}\label{hgkgjhfhjfjfiouykkhhhjhjhjhjhh11ghghghf11kljjkkkhhuuugnbhnhkhhn11}
\vec l_k(t):=\vec r_k(t)-\vec r(t)\quad\quad\forall\,
k\,=\,1,2,\ldots, n.
\end{equation}
Next, under the change of cartesian coordinate system $(*)$ to
another cartesian coordinate system $(**)$ of the form
\begin{equation*}
\begin{cases}
\vec x'=A(t)\cdot \vec x+\vec z(t),\\
t'=t,
\end{cases}
\end{equation*}
where $A(t)\in SO(3)$ is a rotation, we have
\begin{equation}\label{guigikvbvbGGjkilil11}
\begin{cases}
\vec P'(\vec x',t')=A(t)\cdot\vec P(\vec x,t),\\
\vec M'_{\vec u'}(\vec x',t')=A(t)\cdot\vec M_{\vec u}(\vec x,t),\\
\vec M'=A(t)\cdot\vec M-\frac{1}{c}\,\left(\frac{dA}{dt}(t)\cdot\vec
x+\frac{d\vec z}{dt}(t)\right)\times \left(A(t)\cdot\vec P\right),\\
\rho'(\vec x',t')=\rho(\vec x,t),
\\
\vec j'(\vec x',t')=A(t)\cdot \vec j(\vec x,t)+\rho(\vec
x,t)\,\frac{dA}{dt}(t)\cdot\vec x+\rho(\vec x,t)\,\frac{d\vec
z}{dt}(t),
\end{cases}
\end{equation}
provided that
\begin{equation}\label{guigikvbvbGGjkililyjjkuhjhgf11}
\vec u'(\vec x',t')=A(t)\cdot \vec u(\vec
x,t)+\frac{dA}{dt}(t)\cdot\vec x+\frac{d\vec z}{dt}(t).
\end{equation}
Moreover, identities in
\er{hgkgjhfhjfjfiouykkhhhjhjhjhjhhkjljljl;lk;lk;l;k;utityi11} are
invariant under the change of cartesian coordinate system systems,
provided we have \er{guigikvbvbGGjkilil11} and
\er{guigikvbvbGGjkililyjjkuhjhgf11}. Finally, if all the quantities
$\big|\vec l_k(t)\big|$ in
\er{hgkgjhfhjfjfiouykkhhhjhjhjhjhh11ghghghf11kljjkkkhhuuugnbhnhkhhn11}
are infinitely small, then we have the following approximate
equalities for $\vec P(\vec x,t)$, $\vec M(\vec x,t)$ and $\vec
M_{\vec u}(\vec x,t)$ in
\er{hgkgjhfhjfjfiouykkhhhjjgjgkhkhkjujuyfffffffffffffffhjjhjhj11}:
\begin{multline}\label{hgkgjhfhjfjfiouykkhhhjjgjgkhkhkjujuyfffffffffffffffhjjhjhji99u9u9j11}
\vec P(\vec x,t)\approx \sum_{k=1}^{n}\sigma_k\vec
l_k(t)\,\delta\left(\vec x-\vec r(t)\right)\,,\\
\vec M(\vec
x,t)\approx-\frac{1}{c}\left(\sum_{k=1}^{n}\sigma_k\frac{d\vec
r}{dt}(t)\times\vec l_k(t)\,\delta\left(\vec x-\vec
r(t)\right)\right)
\quad\quad\quad\text{and}\quad\\
\vec M_{\vec u}(\vec
x,t)\approx-\frac{1}{c}\left(\sum_{k=1}^{n}\sigma_k\left(\frac{d\vec
r}{dt}(t)-\vec u\left(\vec r(t),t\right)\right)\times\vec
l_k(t)\,\delta\left(\vec x-\vec r(t)\right)\right) \approx\vec
M(\vec x,t)+\frac{1}{c}\vec u(\vec x,t)\times \vec P(\vec x,t)
\end{multline}
and moreover, if $\vec P$, $\vec M$ and $\vec M_{\vec u}$ are given
by
\er{hgkgjhfhjfjfiouykkhhhjjgjgkhkhkjujuyfffffffffffffffhjjhjhji99u9u9j11}
rather than
\er{hgkgjhfhjfjfiouykkhhhjjgjgkhkhkjujuyfffffffffffffffhjjhjhj11},
then, under the change of cartesian coordinate system, the first
three equations in \er{guigikvbvbGGjkilil11} are still valid.
\end{proposition}
\begin{proof}
Let
\begin{multline}\label{hgkgjhfhjfjfiouykkhhhjjgjgkhkhkjujuyfffffffffffffffhjjhjhj11uiuyuyjii}
\vec P(\vec x,t):=\sum_{k=1}^{n}\sigma_k\vec
l_k(t)\int_0^1\,\delta\left(\vec x-\left(\vec r(t)+s\vec
l_k(t)\right)\right)ds\,,\\
\vec M(\vec
x,t):=\\-\frac{1}{c}\left(\sum_{k=1}^{n}\sigma_k\int_0^1\left(\left(\frac{d\vec
r}{dt}(t)+s\frac{d\vec l_k}{dt}(t)\right)\times\vec
l_k(t)\,\delta\left(\vec
x-\left(\vec r(t)+s\vec l_k(t)\right)\right)\right)ds\right)\\
\quad\text{and}\quad\quad\quad\vec M_{\vec u}(\vec
x,t):=\\-\frac{1}{c}\left(\sum_{k=1}^{n}\sigma_k\int_0^1\left(\left(\left(\frac{d\vec
r}{dt}(t)+s\frac{d\vec l_k}{dt}(t)\right)-\vec u\left(\vec
r(t)+s\vec l_k(t),t\right)\right)\times\vec l_k(t)\,\delta\left(\vec
x-\left(\vec r(t)+s\vec l_k(t)\right)\right)\right)ds\right).
\end{multline}
Then, by definition in
\er{hgkgjhfhjfjfiouykkhhhjjgjgkhkhkjujuyfffffffffffffffhjjhjhj11uiuyuyjii}
we obviously have:
\begin{equation}\label{hgkguiuioiioioioyyy}
\vec M_{\vec u}(\vec x,t)=\vec M(\vec x,t)+\frac{1}{c}\vec u(\vec
x,t)\times \vec P(\vec x,t)\,.
\end{equation}
Next, by
\er{hgkgjhfhjfjfiouykkhhhjhjhjhjhh11ghghghf11kljjkkkhhuuug11} and
\er{hgkgjhfhjfjfiouykkhhhjjgjgkhkhkjuj11yyttt11} we have
\begin{multline}\label{hgkgjhfhjfjfiouykkhhhj11}
\rho(\vec x,t):=\sum_{k=1}^{n}\sigma_k\delta\left(\vec x-\left(\vec
r(t)+\vec
l_k(t)\right)\right)=\sum_{k=1}^{n}\sigma_k\left(\delta\left(\vec
x-\left(\vec r(t)+\vec l_k(t)\right)\right)-\delta\left(\vec x-\vec
r(t)\right)\right)=\\-\sum_{k=1}^{n}\sigma_k\int_0^1\vec
l_k(t)\cdot\nabla_{\vec x}\delta\left(\vec x-\left(\vec r(t)+s\vec
l_k(t)\right)\right)ds=\\-div_{\vec
x}\left(\sum_{k=1}^{n}\sigma_k\vec l_k(t)\int_0^1\,\delta\left(\vec
x-\left(\vec r(t)+s\vec l_k(t)\right)\right)ds\right).
\end{multline}
Thus
\begin{equation}\label{hgkgjhfhjfjfiouykkhhhjhjhjhjhh11ghghghf11}
\rho(\vec x,t)=-div_{\vec x}\vec P(\vec x,t),
\end{equation}
where
\begin{equation}\label{hgkgjhfhjfjfiouykkhhhjhjhjhjhh11ghghghf11gffjff11}
\vec P(\vec x,t)=\sum_{k=1}^{n}\sigma_k\vec
l_k(t)\int_0^1\,\delta\left(\vec x-\left(\vec r(t)+s\vec
l_k(t)\right)\right)ds.
\end{equation}
Moreover, if all the quantities $\big|\vec l_k(t)\big|$ are
infinitely small, then by
\er{hgkgjhfhjfjfiouykkhhhjhjhjhjhh11ghghghf11gffjff11} we obviously
have the following approximate equality
\begin{equation}\label{hgkgjhfhjfjfiouykkhhhjhjhjhjhh11ghghghf11gffjff11kjhhjhj11}
\vec P(\vec x,t)\approx \sum_{k=1}^{n}\sigma_k\vec
l_k(t)\,\delta\left(\vec x-\vec r(t)\right).
\end{equation}
Next by
\er{hgkgjhfhjfjfiouykkhhhjhjhjhjhh11ghghghf11kljjkkkhhuuug11} and
\er{hgkgjhfhjfjfiouykkhhhjjgjgkhkhkjuj11yyttt11} we obtain
\begin{multline}\label{hgkgjhfhjfjfiouykkhhhjjgjg11}
\vec j(\vec x,t) :=\sum_{k=1}^{n}\sigma_k\left(\frac{d\vec
r}{dt}(t)+\frac{d\vec l_k}{dt}(t)\right)\delta\left(\vec
x-\left(\vec r(t)+\vec l_k(t)\right)\right)=\\
\sum_{k=1}^{n}\sigma_k\left(\left(\frac{d\vec r}{dt}(t)+\frac{d\vec
l_k}{dt}(t)\right)\delta\left(\vec x-\left(\vec r(t)+\vec
l_k(t)\right)\right)-\frac{d\vec r}{dt}(t)\delta\left(\vec x-\vec
r(t)\right)\right)=\\
\sum_{k=1}^{n}\sigma_k\int_0^1\frac{\partial}{\partial
s}\left(\left(\frac{d\vec r}{dt}(t)+s\frac{d\vec
l_k}{dt}(t)\right)\delta\left(\vec x-\left(\vec r(t)+s\vec
l_k(t)\right)\right)\right)ds
=\\
\sum_{k=1}^{n}\sigma_k\int_0^1\left(\frac{d\vec
l_k}{dt}(t)\delta\left(\vec x-\left(\vec r(t)+s\vec
l_k(t)\right)\right)- \left(\frac{d\vec r}{dt}(t)+s\frac{d\vec
l_k}{dt}(t)\right)\left(\vec l_k(t)\cdot\nabla_{\vec
x}\delta\left(\vec x-\left(\vec r(t)+s\vec
l_k(t)\right)\right)\right)\right)ds
=\\
\sum_{k=1}^{n}\sigma_k\int_0^1\left(\frac{d\vec
l_k}{dt}(t)\delta\left(\vec x-\left(\vec r(t)+s\vec
l_k(t)\right)\right)- \vec l_k(t)\left(\left(\frac{d\vec
r}{dt}(t)+s\frac{d\vec l_k}{dt}(t)\right)\cdot\nabla_{\vec
x}\delta\left(\vec x-\left(\vec r(t)+s\vec
l_k(t)\right)\right)\right)\right)ds
\\+\sum_{k=1}^{n}\sigma_k\int_0^1\left(\vec
l_k(t)\left(\left(\frac{d\vec r}{dt}(t)+s\frac{d\vec
l_k}{dt}(t)\right)\cdot\nabla_{\vec x}\delta\left(\vec x-\left(\vec
r(t)+s\vec l_k(t)\right)\right)\right)\right)ds
\\
-\sum_{k=1}^{n}\sigma_k\int_0^1\left(\left(\frac{d\vec
r}{dt}(t)+s\frac{d\vec l_k}{dt}(t)\right)\left(\vec
l_k(t)\cdot\nabla_{\vec x}\delta\left(\vec x-\left(\vec r(t)+s\vec
l_k(t)\right)\right)\right)\right)ds
=\\
\frac{\partial}{\partial t}\left(\sum_{k=1}^{n}\sigma_k\vec
l_k(t)\int_0^1\delta\left(\vec x-\left(\vec r(t)+s\vec
l_k(t)\right)\right)ds\right)
\\+div_{\vec
x}\left\{\sum_{k=1}^{n}\sigma_k\int_0^1\left(\vec
l_k(t)\otimes\left(\frac{d\vec r}{dt}(t)+s\frac{d\vec
l_k}{dt}(t)\right)\delta\left(\vec x-\left(\vec r(t)+s\vec
l_k(t)\right)\right)\right)ds\right\}
\\
-div_{\vec
x}\left\{\sum_{k=1}^{n}\sigma_k\int_0^1\left(\left(\frac{d\vec
r}{dt}(t)+s\frac{d\vec l_k}{dt}(t)\right)\otimes\vec
l_k(t)\,\delta\left(\vec x-\left(\vec r(t)+s\vec
l_k(t)\right)\right)\right)ds\right\}.
\end{multline}
Then, using \er{apfrm6kkk}, by \er{hgkgjhfhjfjfiouykkhhhjjgjg11} we
deduce
\begin{multline}\label{hgkgjhfhjfjfiouykkhhhjjgjgkhkhk1111}
\vec j(\vec x,t)= \frac{\partial}{\partial
t}\left(\sum_{k=1}^{n}\sigma_k\vec l_k(t)\int_0^1\delta\left(\vec
x-\left(\vec r(t)+s\vec l_k(t)\right)\right)ds\right)
\\
-curl_{\vec
x}\left\{\sum_{k=1}^{n}\sigma_k\int_0^1\left(\left(\frac{d\vec
r}{dt}(t)+s\frac{d\vec l_k}{dt}(t)\right)\times\vec
l_k(t)\,\delta\left(\vec x-\left(\vec r(t)+s\vec
l_k(t)\right)\right)\right)ds\right\}.
\end{multline}
If $\vec u(\vec x,t)$ is the speed like vector field, then by
\er{hgkgjhfhjfjfiouykkhhhjjgjgkhkhk1111} we have
\begin{multline}\label{hgkgjhfhjfjfiouykkhhhjjgjgkhkhk11}
\vec j(\vec x,t)= \frac{\partial}{\partial
t}\left(\sum_{k=1}^{n}\sigma_k\vec l_k(t)\int_0^1\delta\left(\vec
x-\left(\vec r(t)+s\vec l_k(t)\right)\right)ds\right)
\\
-curl_{\vec x}\left\{\sum_{k=1}^{n}\sigma_k\int_0^1\left(\vec
u\left(\vec r(t)+s\vec l_k(t),t\right)\times\vec
l_k(t)\,\delta\left(\vec x-\left(\vec r(t)+s\vec
l_k(t)\right)\right)\right)ds\right\}\\
-curl_{\vec
x}\left\{\sum_{k=1}^{n}\sigma_k\int_0^1\left(\left(\left(\frac{d\vec
r}{dt}(t)+s\frac{d\vec l_k}{dt}(t)\right)-\vec u\left(\vec
r(t)+s\vec l_k(t),t\right)\right)\times\vec l_k(t)\,\delta\left(\vec
x-\left(\vec r(t)+s\vec l_k(t)\right)\right)\right)ds\right\}\\
= \frac{\partial}{\partial t}\left(\sum_{k=1}^{n}\sigma_k\vec
l_k(t)\int_0^1\delta\left(\vec x-\left(\vec r(t)+s\vec
l_k(t)\right)\right)ds\right)
\\
-curl_{\vec x}\left\{\vec u\left(\vec
x,t\right)\times\left(\sum_{k=1}^{n}\sigma_k\vec
l_k(t)\int_0^1\delta\left(\vec x-\left(\vec r(t)+s\vec
l_k(t)\right)\right)ds\right)\right\}\\
-curl_{\vec
x}\left\{\sum_{k=1}^{n}\sigma_k\int_0^1\left(\left(\left(\frac{d\vec
r}{dt}(t)+s\frac{d\vec l_k}{dt}(t)\right)-\vec u\left(\vec
r(t)+s\vec l_k(t),t\right)\right)\times\vec l_k(t)\,\delta\left(\vec
x-\left(\vec r(t)+s\vec l_k(t)\right)\right)\right)ds\right\}.
\end{multline}
Thus, denoting by \er{hgkgjhfhjfjfiouykkhhhj11} and
\er{hgkgjhfhjfjfiouykkhhhjjgjgkhkhkjujuyfffffffffffffffhjjhjhj11uiuyuyjii}
we have
\begin{equation}\label{hgkgjhfhjfjfiouykkhhhjhjhjhjhh11}
\rho(\vec x,t)=-div_{\vec x}\vec P(\vec x,t),
\end{equation}
and by \er{hgkgjhfhjfjfiouykkhhhjjgjgkhkhk11} and
\er{hgkgjhfhjfjfiouykkhhhjjgjgkhkhkjujuyfffffffffffffffhjjhjhj11uiuyuyjii}
we have
\begin{equation}\label{hgkgjhfhjfjfiouykkhhhjjgjgkhkhkjhgutfhfhfyiyy11}
\vec j(x,t)= \frac{\partial\vec P}{\partial t}(\vec x,t)-curl_{\vec
x}\left\{\vec u\left(\vec x,t\right)\times\vec P(\vec
x,t)\right\}+c\,curl_{\vec x}\vec M_{\vec u}(\vec x,t).
\end{equation}
Furthermore, by \er{hgkgjhfhjfjfiouykkhhhjjgjgkhkhkjhgutfhfhfyiyy11}
and \er{hgkguiuioiioioioyyy} we have
\begin{equation}\label{hgkgjhfhjfjfiouykkhhhjjgjgkhkhkjhgutfhfhfyiyy11hjhh}
\vec j(x,t)= \frac{\partial\vec P}{\partial t}(\vec
x,t)+c\,curl_{\vec x}\vec M(\vec x,t).
\end{equation}
Moreover, if all the quantities $\big|\vec l_k(t)\big|$ are
infinitely small, then by
\er{hgkgjhfhjfjfiouykkhhhjjgjgkhkhkjujuyfffffffffffffffhjjhjhj11uiuyuyjii}
we obviously have the following approximate equalities
\begin{multline}\label{hgkgjhfhjfjfiouykkhhhjjgjgkhkhkjujkjjkjkjk11}
\vec P(\vec x,t)\approx \sum_{k=1}^{n}\sigma_k\vec
l_k(t)\,\delta\left(\vec x-\vec r(t)\right)\,,\\
\vec M(\vec x,t) \approx
-\frac{1}{c}\left(\sum_{k=1}^{n}\sigma_k\frac{d\vec
r}{dt}(t)\times\vec l_k(t)\,\delta\left(\vec x-\vec
r(t)\right)\right) \quad\text{and}\\
\quad \vec M_{\vec u}(\vec x,t)
\approx -\frac{1}{c}\left(\sum_{k=1}^{n}\sigma_k\left(\frac{d\vec
r}{dt}(t)-\vec u\left(\vec r(t),t\right)\right)\times\vec
l_k(t)\,\delta\left(\vec x-\vec r(t)\right)\right)\approx\vec M(\vec
x,t)+\frac{1}{c}\vec u(\vec x,t)\times \vec P(\vec x,t) .
\end{multline}

Next, consider the change of cartesian coordinate system system
$(*)$ to another cartesian coordinate system $(**)$ of the form
\begin{equation*}
\begin{cases}
\vec x'=A(t)\cdot \vec x+\vec z(t),\\
t'=t,
\end{cases}
\end{equation*}
where $A(t)\in SO(3)$ is a rotation. Then obviously we have:
\begin{equation}\label{guigikvbvbGGjkilil11uhggy11}
\begin{cases}
\vec u'(\vec x',t')=A(t)\cdot \vec u(\vec
x,t)+\frac{dA}{dt}(t)\cdot\vec x+\frac{d\vec z}{dt}(t)\\
\rho'(\vec x',t')=\rho(\vec x,t),
\\
\vec j'(\vec x',t')=A(t)\cdot \vec j(\vec x,t)+\rho(\vec
x,t)\,\frac{dA}{dt}(t)\cdot\vec x+\rho(\vec x,t)\,\frac{d\vec
z}{dt}(t),
\\
\vec r'(t')=A(t)\cdot\vec r(t)+\vec z(t),
\\
\frac{d\vec r'}{dt'}(t')=A(t)\cdot\frac{d\vec
r}{dt}(t)+\frac{dA}{dt}(t)\cdot\vec r(t)+\frac{d\vec z}{dt}(t),
\\
\vec r'_k(t')=A(t)\cdot\vec r_k(t)+\vec z(t)\quad\quad\forall\,
k\,=\,1,2,\ldots, n,
\\
\frac{d\vec r'_k}{dt'}(t')=A(t)\cdot\frac{d\vec
r_k}{dt}(t)+\frac{dA}{dt}(t)\cdot\vec r_k(t)+\frac{d\vec
z}{dt}(t)\quad\quad\forall\, k\,=\,1,2,\ldots, n,
\end{cases}
\end{equation}
and thus also,
\begin{equation}\label{guigikvbvbGGjkilil11uhggyjijkjk11}
\begin{cases}
\vec l'_k(t')=A(t)\cdot\vec l_k(t)\quad\quad\forall\,
k\,=\,1,2,\ldots, n,\\
\frac{d\vec l'_k}{dt'}(t')=A(t)\cdot\frac{d\vec
l_k}{dt}(t)+\frac{dA}{dt}(t)\cdot\vec l_k(t)\quad\quad\forall\,
k\,=\,1,2,\ldots, n,
\\
\left(\vec r'(t')+s\vec l'_k(t')\right)=A(t)\cdot\left(\vec r(t)+s
\vec l_k(t)\right)+\vec z(t)\quad\quad\forall\, k\,=\,1,2,\ldots,
n\;\;\forall s,
\\
\left(\frac{d\vec r'}{dt'}(t')+s\frac{d\vec
l'_k}{dt'}(t')\right)=A(t)\cdot\left(\frac{d\vec
r}{dt}(t)+s\frac{d\vec
l_k}{dt}(t)\right)+\frac{dA}{dt}(t)\cdot\left(\vec r(t)+s\vec
l_k(t)\right)+\frac{d\vec z}{dt}(t),
\\
\left(\frac{d\vec r'}{dt'}(t')+s\frac{d\vec l'_k}{dt'}(t')-\vec
u'\left(\vec r'(t')+s\vec
l'_k(t'),t'\right)\right)=A(t)\cdot\left(\frac{d\vec
r}{dt}(t)+s\frac{d\vec l_k}{dt}(t)-\vec u\left(\vec r(t)+s\vec
l_k(t),t\right)\right).
\end{cases}
\end{equation}
Therefore, as before, by \er{guigikvbvbGGjkilil11uhggy11} and
\er{guigikvbvbGGjkilil11uhggyjijkjk11}, $\vec P(\vec x,t)$ and $\vec
M_{\vec u}(\vec x,t)$, defined either by
\er{hgkgjhfhjfjfiouykkhhhjjgjgkhkhkjujuyfffffffffffffffhjjhjhj11uiuyuyjii}
or by \er{hgkgjhfhjfjfiouykkhhhjjgjgkhkhkjujkjjkjkjk11}, clearly
satisfy the first two equations in \er{guigikvbvbGGjkilil11}.
Moreover, by the first two equations in \er{guigikvbvbGGjkilil11}
together with \er{hgkguiuioiioioioyyy} and the first equation in
\er{guigikvbvbGGjkilil11uhggy11} we deduce the third equation in
\er{guigikvbvbGGjkilil11}. Finally, as before, identities in either
\er{hgkgjhfhjfjfiouykkhhhjhjhjhjhhkjljljl;lk;lk;l;k;utityi11} or
\er{hgkgjhfhjfjfiouykkhhhjhjhjhjhhkjljljl;lk;lk;l;k;utityi11jijij}
are invariant under the change of cartesian coordinate systems,
provided we have \er{guigikvbvbGGjkilil11} and
\er{guigikvbvbGGjkililyjjkuhjhgf11}.
\end{proof}

Next consider a complex system containing $m$ totally neutral
simpler systems, such that for every $k=1,2,\ldots,m$ every simple
system consists of $n_k$ point charges
$\sigma^{(k)}_1,\ldots,\sigma^{(k)}_{n_k}$, with places $\vec
r^{(k)}_1(t),\ldots,\vec r^{(k)}_{n_k}(t)$ which are very close to
the place $\vec r^{(k)}(t)$ and velocities $\frac{d\vec
r^{(k)}_1}{dt}(t),\ldots, \frac{d\vec r^{(k)}_{n_k}}{dt}(t)$ which
are very close to the velocity $ \frac{d\vec r^{(k)}}{dt}(t)$,
satisfying
\begin{equation}\label{hgkgjhfhjfjfiouykkhhhjjgjgkhkhkjuj11yyttt11jkjik11}
\sum_{j=1}^{n_k}\sigma^{(k)}_j=0\quad\quad\forall k=1,2,\ldots,m.
\end{equation}
and such that all the quantities $\big|\vec l^{(k)}_j(t)\big|$
defined by
\begin{equation}\label{hgkgjhfhjfjfiouykkhhhjhjhjhjhh11ghghghf11kljjkkkhhuuugnbhnhkhhn11jkjk11hjghhghghghg11}
\vec l^{(k)}_j(t):=\vec r^{(k)}_j(t)-\vec
r^{(k)}(t)\quad\quad\forall\, k\,=\,1,2,\ldots, m,\;\;\forall\,
j\,=\,1,2,\ldots, n_k,
\end{equation}
are infinitely small. As an example of such a complex system we can
take a system containing $m$ point \underline{dipoles} or $m$
totally neutral molecules. Then, consistently with
\er{hgkgjhfhjfjfiouykkhhhjjgjgkhkhkjujuyfffffffffffffffhjjhjhji99u9u9j11}
we can write the Polarization and the Magnetization of the system
as:
\begin{multline}\label{hgkgjhfhjfjfiouykkhhhjjgjgkhkhkjujuyfffffffffffffffhjjhjhji99u9u9j11ggjgghgh11}
\vec P(\vec x,t)\approx
\sum_{k=1}^{m}\sum_{j=1}^{n_k}\sigma^{(k)}_j\vec l^{(k)}_j(t)\,\delta\left(\vec x-\vec r^{(k)}(t)\right)\quad\quad\quad\text{and}\quad\\
\vec M(\vec
x,t)\approx-\frac{1}{c}\left(\sum_{k=1}^{m}\sum_{j=1}^{n_k}\sigma^{(k)}_j
\frac{d\vec
r^{(k)}}{dt}(t)
\times\vec l^{(k)}_j(t)\,\delta\left(\vec x-\vec
r^{(k)}(t)\right)\right).
\end{multline}
\begin{remark}\label{gjgkgkhgghhg}
In the frames of Quantum Physics one can show that there is an
additional different term in the expression of the Magnetization
$\vec M(\vec x,t)$, arisen from the spin of charged particles.
However, it is impossible to describe this term in the frames of
Classical Mechanics.
\end{remark}

\subsection{General setting of Maxwell equations in the medium} Consider system
\er{MaxVacFull1bjkgjhjhgjaaaPPN}
in some inertial or non-inertial cartesian coordinate system inside
a dielectric and/or magnetic medium:
\begin{equation}\label{MaxVacFullnnnnGG}
\begin{cases}
curl_{\vec x} \vec H_0\equiv \frac{4\pi}{c}\left(\vec j+\vec
j_{mp}\right)+
\frac{1}{c}\frac{\partial \vec D_0}{\partial t}\quad\text{for}\;\;(\vec x,t)\in\R^3\times[t_0,+\infty),\\
div_{\vec x} \vec D_0\equiv 4\pi\left(\rho+\rho_p\right)\quad\quad\text{for}\;\;(\vec x,t)\in\R^3\times[t_0,+\infty),\\
curl_{\vec x} \vec E+\frac{1}{c}\frac{\partial \vec B}{\partial t}\equiv 0\quad\quad\text{for}\;\;(\vec x,t)\in\R^3\times[t_0,+\infty),\\
div_{\vec x} \vec B\equiv 0\quad\quad\text{for}\;\;(\vec
x,t)\in\R^3\times[t_0,+\infty),
\end{cases}
\end{equation}
where $\vec E$ is the electric field, $\vec B$ is the magnetic
field, $\vec v:=\vec v(\vec x,t)$ is the vectorial gravitational
potential, $\rho$ is the average (macroscopic) charge density,
$\rho_p$ is the density of the charge of polarization, $\vec j$ is
the average (macroscopic) current density, $\vec j_{mp}$ is the
density of the current of polarization and magnetization and
\begin{equation}\label{MaxVacFullnnnngkggjkklhGG}
\vec D_0:=\vec E+\frac{1}{c}\,\vec v\times \vec
B\quad\text{and}\quad
\vec H_0:=\vec B+\frac{1}{c}\,\vec v\times \vec D_0.
\end{equation}
Consistently with subsection \ref{huugftdtdtddtdt}, it is well known
from the Lorentz theory that in the case of a moving
dielectric/magnetic medium
\begin{equation}\label{PolarGG}
\rho_p=-div_{\vec x} \vec P\quad\text{and}\quad
\vec j_{mp}=\frac{\partial \vec P}{\partial t}+c\, curl_{\vec x}
\vec M
\end{equation}
where $\vec P:\R^3\times[t_0,+\infty)\to\R^3$ is the field of
polarization, $\vec M:\R^3\times[t_0,+\infty)\to\R^3$ is the field
of magnetization and $\vec u:=\vec u(\vec x,t)$ is the field of
velocities of the dielectric medium, see equation
\er{hgkgjhfhjfjfiouykkhhhjhjhjhjhhkjljljl;lk;lk;l;k;utityi11jijij}
in Proposition \ref{hjghfhfhf11} for the proof of \er{PolarGG}.
Thus, if we consider
\begin{equation}\label{OprdddGG}
\vec D:=\vec D_0+4\pi \vec P=\vec E+\frac{1}{c}\,\vec v\times \vec
B+4\pi \vec P,
\end{equation}
and
\begin{multline}\label{Oprddd1GG}
\vec H:=\vec H_0-4\pi \vec M
=\vec B+\frac{1}{c}\,\vec v\times \vec D_0
-4\pi \vec M
=\vec B
+\frac{1}{c}\,\vec v\times
\vec E+\frac{1}{c}\,\vec v\times\left(\frac{1}{c}\,\vec v\times \vec
B\right)-4\pi \vec M,
\end{multline}
we obtain the usual Maxwell equations of the form:
\begin{equation}\label{MaxMedFullGG}
\begin{cases}
curl_{\vec x} \vec H\equiv \frac{4\pi}{c}\vec j+
\frac{1}{c}\frac{\partial \vec D}{\partial t}\quad\text{for}\;\;(\vec x,t)\in\R^3\times[t_0,+\infty),\\
div_{\vec x} \vec D\equiv 4\pi\rho\quad\quad\text{for}\;\;(\vec x,t)\in\R^3\times[t_0,+\infty),\\
curl_{\vec x} \vec E+\frac{1}{c}\frac{\partial \vec B}{\partial t}\equiv 0\quad\quad\text{for}\;\;(\vec x,t)\in\R^3\times[t_0,+\infty),\\
div_{\vec x} \vec B\equiv 0\quad\quad\text{for}\;\;(\vec
x,t)\in\R^3\times[t_0,+\infty),
\end{cases}
\end{equation}
We call $\vec D$ by the electric displacement field and $\vec H$ by
the $\vec H$-magnetic field in a medium.

\subsection{Change of Non-inertial coordinate system} Consider the
change of certain non-inertial cartesian coordinate system $(*)$ to
another cartesian coordinate system $(**)$ of the form
\begin{equation*}
\begin{cases}
\vec x'=A(t)\cdot \vec x+\vec z(t),\\
t'=t,
\end{cases}
\end{equation*}
where $A(t)\in SO(3)$ is a rotation. Then, as before in
\er{yuythfgfyftydtydtydtyddyyyhhddhhhredPPN111hgghjg}, denoting
$\vec w(t)=\vec z'(t)$, we have the following relations between the
physical quantities in coordinate systems $(*)$ and $(**)$:
\begin{equation}\label{guigikvbvbGG}
\begin{cases}
\vec E'=A(t)\cdot\vec E-\frac{1}{c}\,\left(A'(t)\cdot\vec x+\vec
w(t)\right)\times \left(A(t)\cdot\vec B\right),\\
\vec B'=A(t)\cdot\vec B,\\
\vec D'_0=A(t)\cdot \vec D_0,\\
\vec H'_0=A(t)\cdot\vec H_0+\frac{1}{c}\,\left(A'(t)\cdot\vec x+\vec
w(t)\right)\times \left(A(t)\cdot\vec D_0\right),\\
\vec u'=A(t)\cdot \vec u+A'(t)\cdot\vec x+\vec w(t),\\
\vec P'=A(t)\cdot\vec P,\\
\vec M'=A(t)\cdot\vec M-\frac{1}{c}\,\left(A'(t)\cdot\vec x+\vec
w(t)\right)\times \left(A(t)\cdot\vec P\right),
\end{cases}
\end{equation}
see subsection \ref{huugftdtdtddtdt} for the justification of the
last two equalities in \er{guigikvbvbGG}. Plugging it into
\er{OprdddGG} and \er{Oprddd1GG} we deduce
\begin{equation}\label{OprdddkkkmGG}
\vec D':=\vec D'_0+4\pi \vec P'=A(t)\cdot\left(\vec D_0+4\pi \vec
P\right)=A(t)\cdot\vec D,
\end{equation}
and
\begin{multline}\label{Oprddd1kkkmGG}
\vec H':=\vec H'_0-4\pi \vec M'
=A(t)\cdot\vec H_0+\frac{1}{c}\,\left(A'(t)\cdot\vec x+\vec
w(t)\right)\times \left(A(t)\cdot\vec D_0\right)\\-4\pi
A(t)\cdot\vec M+\frac{4\pi}{c}\,\left(
A'(t)\cdot\vec x+\vec
w(t)\right)\times \left(A(t)\cdot\vec P\right)\\
=A(t)\cdot\left(\vec H_0-4\pi \vec M
\right)
+\frac{1}{c}\,\left(A'(t)\cdot\vec x+\vec
w(t)\right)\times \left(A(t)\cdot\left(\vec D_0+4\pi\vec
P\right)\right)\\= A(t)\cdot\vec H+\frac{1}{c}\,\left(A'(t)\cdot\vec
x+\vec w(t)\right)\times \left(A(t)\cdot\vec D\right) ,
\end{multline}
So the expressions of transformations under the change of
non-inertial cartesian coordinate system in a dielectric/magnetic
medium exactly the same as in the vacuum, i.e. having the form of
\begin{equation}\label{guigikvbvbggjklhjkkgjgGGGG}
\begin{cases}
\vec D'=A(t)\cdot \vec D\\
\vec B'=A(t)\cdot\vec B\\
\vec E'=A(t)\cdot\vec E-\frac{1}{c}\,\left(A'(t)\cdot\vec x+\vec
w(t)\right)\times \left(A(t)\cdot\vec B\right)\\
\vec H'=A(t)\cdot\vec H+\frac{1}{c}\,\left(A'(t)\cdot\vec x+\vec
w(t)\right)\times \left(A(t)\cdot\vec D\right).
\end{cases}
\end{equation}
Moreover, as before, the Maxwell equations in the medium of the form
\er{MaxMedFullGG} stays invariant under the change of inertial or
non-inertial cartesian coordinate system, provided we have
\er{guigikvbvbggjklhjkkgjgGGGG} and
\begin{equation}\label{hjjgkooo}
\vec j'=A(t)\cdot \vec j+\rho\,A'(t)\cdot\vec x+\rho\,\vec w(t).
\end{equation}

\subsection{Case of simplest dielectrics/magnetics}\label{hhhhhhijh}
It is well known that in the case of simplest resting homogenous
isotropic nonmagnetic dielectric medium we have
\begin{equation}\label{EBDHTrans444knbuihuigGGojkjkuukjljupp}
\begin{cases}
\vec P=\frac{n^2-1}{4\pi}\,\vec E,\\
\vec M=0,
\end{cases}
\end{equation}
where $n$ is a material coefficient (not necessary constant), called
\underline{refraction index}. Moreover, It is well known that in the
case of simplest resting homogenous isotropic dielectric medium with
certain magnetic properties we rewrite
\er{EBDHTrans444knbuihuigGGojkjkuukjljupp} as
\begin{equation}\label{EBDHTrans444knbuihuigGGjojojjjjoojj}
\begin{cases}
\vec P=\frac{n^2-1}{4\pi}\,\vec E-\kappa\vec D,\\
\vec M=\kappa\vec H,
\end{cases}
\end{equation}
where $n$ is the refraction index and $\kappa$ is an additional
material coefficient (not necessary constant). Either
\er{EBDHTrans444knbuihuigGGojkjkuukjljupp} or
\er{EBDHTrans444knbuihuigGGjojojjjjoojj} are valid only for points
and instants of time where the velocity of the medium $\vec u$
vanishes. In the case of nonmagnetic dielectric we just put
$\kappa=0$ and \er{EBDHTrans444knbuihuigGGjojojjjjoojj} becomes to
be the same as \er{EBDHTrans444knbuihuigGGojkjkuukjljupp}.

Next, in the case of simplest moving homogenous isotropic
nonmagnetic dielectric medium we can generalize
\er{EBDHTrans444knbuihuigGGojkjkuukjljupp} as:
\begin{equation}\label{EBDHTrans444knbuihuigGGojkjkuuku}
\begin{cases}
\vec P=\frac{n^2-1}{4\pi}\left(\vec E+\frac{1}{c}\,\vec u\times \vec B\right),\\
\vec M=-\frac{1}{c}\,\frac{n^2-1}{4\pi}\,\vec u\times \left(\vec
E+\frac{1}{c}\,\vec u\times \vec B\right)=-\frac{1}{c}\vec u\times
\vec P,
\end{cases}
\end{equation}
where $n$ is a material coefficient (not necessary constant), called
\underline{refraction index} and $\vec u$ is the velocity of the
medium. Moreover, in the case of simplest moving homogenous
isotropic dielectric medium with certain magnetic properties we can
generalize \er{EBDHTrans444knbuihuigGGjojojjjjoojj} as:
\begin{equation}\label{EBDHTrans444knbuihuigGG}
\begin{cases}
\vec P=\frac{n^2-1}{4\pi}\left(\vec E+\frac{1}{c}\,\vec u\times \vec B\right)-\kappa\vec D,\\
\vec M=-\frac{1}{c}\,\frac{n^2-1}{4\pi}\,\vec u\times \left(\vec
E+\frac{1}{c}\,\vec u\times \vec B\right)+\kappa\vec
H=-\frac{1}{c}\vec u\times \vec P+\kappa\left(\vec
H-\frac{1}{c}\,\vec u\times \vec D\right),
\end{cases}
\end{equation}
where $n$ is the refraction index and $\kappa$ is an additional
material coefficient (not necessary constant). In the case of
nonmagnetic dielectric we just put $\kappa=0$ and
\er{EBDHTrans444knbuihuigGG} becomes to be the same as
\er{EBDHTrans444knbuihuigGGojkjkuuku}.
Then, using \er{guigikvbvbGG} and \er{guigikvbvbggjklhjkkgjgGGGG},
it can be easily seen that the laws in either
\er{EBDHTrans444knbuihuigGGojkjkuuku} or
\er{EBDHTrans444knbuihuigGG} are invariant under the changes of
inertial or non-inertial cartesian coordinate system. Alternatively,
one can assume either \er{EBDHTrans444knbuihuigGGojkjkuukjljupp} and
\er{EBDHTrans444knbuihuigGGjojojjjjoojj} for the case $\vec u=0$ and
to postulate the invariance under the changes of inertial or
non-inertial cartesian coordinate system. Then, using
\er{guigikvbvbGG} and \er{guigikvbvbggjklhjkkgjgGGGG}, one can
definitely deduce either \er{EBDHTrans444knbuihuigGGojkjkuuku} or
\er{EBDHTrans444knbuihuigGG} in the case $\vec u\neq 0$, similarly
as it was done in section \ref{MaxRevsPPN}, for establishing the
relations $\vec D\sim \vec E$ and $\vec H\sim \vec B$ in the case
$\vec v\neq 0$.

Next, plugging \er{EBDHTrans444knbuihuigGG} into \er{OprdddGG} and
\er{Oprddd1GG} gives,
\begin{equation}\label{OprdddsimGG}
\vec D:=\vec E+\frac{1}{c}\,\vec v\times \vec B+4\pi
\left(\frac{n^2-1}{4\pi}\left(\vec E+\frac{1}{c}\,\vec u\times \vec
B\right)-\kappa\vec D\right),
\end{equation}
and
\begin{equation}\label{Oprddd1simGG}
\vec H: =\vec B
+\frac{1}{c}\,\vec v\times \vec E+\frac{1}{c}\,\vec
v\times\left(\frac{1}{c}\,\vec v\times \vec
B\right)-4\pi\left(-\frac{1}{c}\,\frac{n^2-1}{4\pi}\,\vec u\times
\left(\vec E+\frac{1}{c}\,\vec u\times \vec B\right)+\kappa\vec
H\right).
\end{equation}
We rewrite \er{OprdddsimGG} as:
\begin{equation}\label{OprdddsimGGffyzz}
\vec E=\frac{1+4\pi\kappa}{n^2}\,\vec
D-\frac{1}{c}\,\left(\frac{1}{n^2}\vec
v+\left(1-\frac{1}{n^2}\right)\,\vec u\right)\times \vec B,
\end{equation}
and by \er{OprdddsimGG} and \er{OprdddsimGGffyzz} we rewrite
\er{Oprddd1simGG} as:
\begin{multline}\label{Oprddd1simGGffyzzhhh}
\left(1+4\pi \kappa\right)\vec H =\vec B
+\frac{1}{c}\,\vec v\times \vec
E+\frac{1}{c}\,\left(n^2-1\right)\,\vec u\times \vec
E+\frac{1}{c}\,\vec v\times\left(\frac{1}{c}\,\vec v\times \vec
B\right)+\frac{1}{c}\,\left(n^2-1\right)\,\vec u\times
\left(\frac{1}{c}\,\vec u\times \vec B\right)\\=\vec B
+\frac{1}{c}\,\left(1+4\pi \kappa\right)\,\left(\frac{1}{n^2}\vec
v+\left(1-\frac{1}{n^2}\right)\,\vec u\right)\times \vec D
\\
-\frac{1}{c}\,\vec v\times
\left(\frac{1}{c}\,\left(\frac{1}{n^2}\vec
v+\left(1-\frac{1}{n^2}\right)\,\vec u\right)\times \vec
B\right)-\frac{1}{c}\,\left(n^2-1\right)\,\vec u\times
\left(\frac{1}{c}\,\left(\frac{1}{n^2}\vec
v+\left(1-\frac{1}{n^2}\right)\,\vec u\right)\times \vec
B\right)\\+\frac{1}{c}\,\vec v\times\left(\frac{1}{c}\,\vec v\times
\vec B\right)+\frac{1}{c}\,\left(n^2-1\right)\,\vec u\times
\left(\frac{1}{c}\,\vec u\times \vec B\right)=\\
\vec B
+\frac{1}{c}\,\left(1+4\pi \kappa\right)\,\left(\frac{1}{n^2}\vec
v+\left(1-\frac{1}{n^2}\right)\,\vec u\right)\times \vec D
+\frac{1}{c^2}\,\left(1-\frac{1}{n^2}\right)\,\left(\vec u-\,\vec
v\right)\times \left(\left(\vec u-\,\vec v\right)\times \vec
B\right)\,,
\end{multline}
so that
\begin{equation}\label{Oprddd1simGGffyzz}
\vec H=\frac{1}{1+4\pi \kappa}\vec
B+\frac{1}{c}\left(\frac{1}{n^2}\vec
v+\left(1-\frac{1}{n^2}\right)\,\vec u\right)\times \vec D
+\frac{1}{c^2}\,\frac{1}{1+4\pi
\kappa}\,\left(1-\frac{1}{n^2}\right)\,\left(\vec u-\,\vec
v\right)\times \left(\left(\vec u-\,\vec v\right)\times \vec
B\right).
\end{equation}
Thus denoting
\begin{equation}\label{hkjhjh}
\kappa_0=\frac{1}{1+4\pi
\kappa}\quad\text{and}\quad\gamma_0=\frac{1+4\pi\kappa}{n^2}=\frac{1}{\kappa_0n^2}\quad\text{so
that}\quad n=\frac{1}{\sqrt{\gamma_0\kappa_0}}
\end{equation}
and defining the speed-like vector field
\begin{equation}\label{OprdddsimGGffyhjyhhzz}
\vec {\tilde u}:=\left(\frac{1}{n^2}\vec
v+\left(1-\frac{1}{n^2}\right)\,\vec
u\right)=\left(\gamma_0\kappa_0\vec
v+\left(1-\gamma_0\kappa_0\right)\,\vec u\right),
\end{equation}
by \er{OprdddsimGGffyzz} and \er{Oprddd1simGGffyzz}  we deduce
\begin{equation}\label{OprdddsimGGffykkzz}
\vec E=\gamma_0\vec D-\frac{1}{c}\,\vec {\tilde u}\times \vec B,
\end{equation}
and
\begin{equation}\label{Oprddd1simGGffykkzz}
\vec H=\kappa_0\vec B+\frac{1}{c}\,\vec {\tilde u}\times \vec
D+\kappa_0\frac{(1-\gamma_0\kappa_0)}{c^2}\,(\vec u-\vec
v)\times\left(\left(\vec u-\vec v\right)\times \vec B\right),
\end{equation}
where we call $\gamma_0$ and $\kappa_0$ dielectric and magnetic
permeability of the medium. Thus, by \er{MaxMedFullGG},
\er{OprdddsimGGffyhjyhhzz} \er{OprdddsimGGffykkzz} and
\er{Oprddd1simGGffykkzz} we have
\begin{equation}\label{MaxMedFullGGffykkzz}
\begin{cases}
curl_{\vec x} \vec H=\frac{4\pi}{c}\vec j+
\frac{1}{c}\frac{\partial \vec D}{\partial t},\\
div_{\vec x} \vec D=4\pi\rho,\\
curl_{\vec x} \vec E+\frac{1}{c}\frac{\partial \vec B}{\partial t}=0,\\
div_{\vec x} \vec B=0,\\
\vec E=\gamma_0\vec D-\frac{1}{c}\,\vec {\tilde u}\times \vec B,\\
\vec H=\kappa_0\vec B+\frac{1}{c}\,\vec {\tilde u}\times \vec
D+\frac{\kappa_0(1-\gamma_0\kappa_0)}{c^2}\,(\vec u-\vec
v)\times\left(\left(\vec
u-\vec v\right)\times \vec B\right),\\
\vec {\tilde u}:=\left(\gamma_0\kappa_0\vec
v+\left(1-\gamma_0\kappa_0\right)\vec u\right),
\end{cases}
\end{equation}
where $\vec {\tilde u}$ is a speed-like vector field that we call
the optical displacement of the moving medium. Note that for the
case $\gamma_0=1$ and $\kappa_0=1$, the system
\er{MaxMedFullGGffykkzz} is exactly the same as the corresponding
system in the vacuum. The equations in \er{MaxMedFullGGffykkzz} take
much simpler forms in the case where the quantity
\begin{equation}\label{OprdddsimGGffyhjyhhtygrffgfzz}
\frac{|\kappa_0||1-\gamma_0\kappa_0|\cdot|\vec u-\vec v|^2}{c^2}\ll
1
\end{equation} is
negligible, that happens if either the absolute value of the
difference between the medium velocity and vectorial gravitational
potential is much less then the constant $c$ or
$\gamma_0\kappa_0=\frac{1}{n^2}$ is close to the value $1$. Indeed,
in this case, instead of \er{OprdddsimGGffykkzz} and
\er{Oprddd1simGGffykkzz} we obtain the following relations:
\begin{align}\label{OprdddsimsimsimGG}
\vec E=\gamma_0\vec D-\frac{1}{c}\,\vec {\tilde u}\times \vec B,\\
\label{Oprddd1simsimsimGG}
\vec H=\kappa_0\vec B+\frac{1}{c}\,\vec {\tilde u}\times \vec D.
\end{align}
As a consequence we obtain the full system of Maxwell equations in
the medium:
\begin{equation}\label{MaxMedFullGGffykkhjhhzz}
\begin{cases}
curl_{\vec x} \vec H=\frac{4\pi}{c}\vec j+
\frac{1}{c}\frac{\partial \vec D}{\partial t},\\
div_{\vec x} \vec D=4\pi\rho,\\
curl_{\vec x} \vec E+\frac{1}{c}\frac{\partial \vec B}{\partial t}=0,\\
div_{\vec x} \vec B=0,\\
\vec E=\gamma_0\vec D-\frac{1}{c}\,\vec {\tilde u}\times \vec B,\\
\vec H=\kappa_0\vec B+\frac{1}{c}\,\vec {\tilde u}\times \vec D,\\
\vec {\tilde u}=\left(\gamma_0\kappa_0\vec
v+\left(1-\gamma_0\kappa_0\right)\vec u\right),
\end{cases}
\end{equation}
where $\vec {\tilde u}$ is the speed-like vector field and
$\gamma_0$ and $\kappa_0$ are dielectric and magnetic permeability
of the medium. Note that \er{MaxMedFullGGffykkhjhhzz} is analogous
to the system of Maxwell equations in the vacuum and it is also
invariant under the change of inertial or non-inertial cartesian
coordinate system, provided that under this transformation we have
\er{guigikvbvbggjklhjkkgjgGGGG}.

\subsection{Case of anisotropic dielectrics/magnetics} It is well
known that in the case the simplest anisotropic dielectrics and/or
magnetics we have the following generalization of
\er{EBDHTrans444knbuihuigGG}:
\begin{equation}\label{EBDHTrans444knbuihuigGGaa}
\begin{cases}
\vec P=\Gamma\cdot\left(\vec E+\frac{1}{c}\,\vec u\times \vec B\right)-\Upsilon\cdot\vec D,\\
\vec M=
-\frac{1}{c}\vec u\times \vec P+\Upsilon\cdot\left(\vec
H-\frac{1}{c}\,\vec u\times \vec D\right),
\end{cases}
\end{equation}
where $\Gamma\in \mathbb{R}^{3\times 3}$ and $\Upsilon\in
\mathbb{R}^{3\times 3}$ are matrix-valued fields.
Then, using \er{guigikvbvbGG} it can be easily seen that the laws in
\er{EBDHTrans444knbuihuigGGaa} are invariant under the changes of
inertial or non-inertial cartesian coordinate system, provided that
$\Gamma$ and $\Upsilon$ are \underline{proper} matrix fields (see
Definition \ref{bggghghgj}).

\subsection{Ohm's Law in a conducting medium} It is well known that
the Ohm's Law in a conducting medium has the form
\begin{equation}\label{vjhfhjtjhjhuyyiyGG}
\vec j-\rho\,\vec u=\varepsilon\left(\vec E+\frac{1}{c}\,\vec
u\times \vec B\right),
\end{equation}
where $\vec u$ is the velocity of the medium and $\varepsilon$ is a
material coefficient. As before, using
\er{guigikvbvbGG}, it can be easily seen that the Ohm's Law is
invariant under the changes of inertial or non-inertial cartesian
coordinate system.

Furthermore, it is well known that in the case of the strong
magnetic field the modification of the the Ohm's Law including the
Hall effect has the following form:
\begin{equation}\label{vjhfhjtjhjhuyyiyGGjlj}
\vec j-\rho\,\vec u=\varepsilon\left(\vec E+\frac{1}{c}\,\vec
u\times \vec B\right)-\varsigma\left(\vec j-\rho\,\vec
u\right)\times \vec B,
\end{equation}
where $\varsigma$ is a material coefficient. Then, as before, using
\er{guigikvbvbGG}, it can be easily seen that the generalized Ohm's
Law \er{vjhfhjtjhjhuyyiyGGjlj}, including the Hall effect, is
invariant under the changes of inertial or non-inertial cartesian
coordinate system.

Next, it is well known that, in the case of the anisotropic
conducting medium, the Ohm's Law has the following form,
generalizing \er{vjhfhjtjhjhuyyiyGG}:
\begin{equation}\label{vjhfhjtjhjhuyyiyGGaa}
\vec j-\rho\,\vec u=\Xi\cdot\left(\vec E+\frac{1}{c}\,\vec u\times
\vec B\right),
\end{equation}
where $\Xi\in \mathbb{R}^{3\times 3}$ is a matrix-valued field. As
before, using
\er{guigikvbvbGG}, it can be easily seen that
\er{vjhfhjtjhjhuyyiyGGaa} is invariant under the changes of inertial
or non-inertial cartesian coordinate system, provided that $\Xi$ is
a \underline{proper} matrix field.

Finally, the generalization of \er{vjhfhjtjhjhuyyiyGGjlj} to
anisotropic mediums is the following:
\begin{equation}\label{vjhfhjtjhjhuyyiyGGjljaa}
\vec j-\rho\,\vec u=\Xi\cdot\left(\vec E+\frac{1}{c}\,\vec u\times
\vec B\right)-\Pi\cdot\left(\left(\vec j-\rho\,\vec u\right)\times
\vec B\right),
\end{equation}
where $\Xi\in \mathbb{R}^{3\times 3}$ and $\Pi\in
\mathbb{R}^{3\times 3}$  are matrix-valued fields. As before, using
\er{guigikvbvbGG}, it can be easily seen that
\er{vjhfhjtjhjhuyyiyGGjljaa} is invariant under the changes of
inertial or non-inertial cartesian coordinate system, provided that
$\Xi$ and $\Pi$ are \underline{proper} matrix fields.

\subsection{Optical dispersion in moving
mediums}\label{hugygyggygygyfgy} Remind that if  $U$ is a real
valued field, then we can write the field $U$ as a Fourier's
Transform on the time variable:
\begin{equation}\label{MaxVacFullPPNmmmffffffhhtygghGGGGyuhggghghghffggggjj11}
U(\vec x,t)=Re\left\{2\int_{0}^{+\infty} \hat U(\vec
x,\omega)e^{i\omega t}d\omega\right\}\quad\text{where}\quad\hat
U(\vec x,\omega):=\frac{1}{2\pi}\int_{-\infty}^{+\infty} U(\vec
x,t)e^{-i\omega t}dt\,.
\end{equation}
Thus, since $\vec E$, $\vec B$, $\vec D$ and $\vec H$ are real
vectors, we can write them as a Fourier's Transform on the time
variable:
\begin{align}\label{MaxVacFullPPNmmmffffffhhtygghGGGGyuhggghghghffggggjj11jhhhjj11hgjgj11ghfh11}
\vec E(\vec x,t)=Re\left\{2\int_{0}^{+\infty}  {\vec {\hat
E}(\omega,\vec x)}e^{i\omega t}d\omega\right\}\quad\text{where}\quad
{\vec {\hat E}(\omega,\vec
x)}:=\frac{1}{2\pi}\int_{-\infty}^{+\infty}
\vec E(\vec x,t)e^{-i \omega t}dt\,,\\
\label{MaxVacFullPPNmmmffffffhhtygghGGGGyuhggghghghffggggjj11jhhhjj11ihhgjjj11mbbmbm11}
\vec B(\vec x,t)=Re\left\{2\int_{0}^{+\infty}  {\vec {\hat
B}(\omega,\vec x)}e^{i\omega t}d\omega\right\}\quad\text{where}\quad
{\vec {\hat B}(\omega,\vec
x)}:=\frac{1}{2\pi}\int_{-\infty}^{+\infty} \vec B(\vec x,t)e^{-i
\omega
t}dt\,,\\
\label{MaxVacFullPPNmmmffffffhhtygghGGGGyuhggghghghffggggjj11jhhhjj11hgjgj11ghfh11D}
\vec D(\vec x,t)=Re\left\{2\int_{0}^{+\infty}  {\vec {\hat
D}(\omega,\vec x)}e^{i\omega t}d\omega\right\}\quad\text{where}\quad
{\vec {\hat D}(\omega,\vec
x)}:=\frac{1}{2\pi}\int_{-\infty}^{+\infty}
\vec D(\vec x,t)e^{-i \omega t}dt\,,\\
\label{MaxVacFullPPNmmmffffffhhtygghGGGGyuhggghghghffggggjj11jhhhjj11ihhgjjj11mbbmbm11H}
\vec H(\vec x,t)=Re\left\{2\int_{0}^{+\infty}  {\vec {\hat
H}(\omega,\vec x)}e^{i\omega t}d\omega\right\}\quad\text{where}\quad
{\vec {\hat H}(\omega,\vec
x)}:=\frac{1}{2\pi}\int_{-\infty}^{+\infty} \vec H(\vec x,t)e^{-i
\omega t}dt\,,
\end{align}
where given $\vec a=(a_1,a_2,a_3)\in \mathbb{C}^3$ we denote
$Re\,\{\vec a\}=\left(Re\,\{a_1\},Re\,\{a_2\},Re\,\{a_3\}\right)\in
\mathbb{R}^3$. Then, it is well known that in the case of the
simplest optical dispersion in the resting medium with $\vec u\equiv
0$ we have the following generalization of
\er{EBDHTrans444knbuihuigGGjojojjjjoojj}:
\begin{equation}\label{EBDHTrans444knbuihuigGGjj11jgggggjj11hhhjjj11gvghhghgjj11jgjggj11hghffh11}
\begin{cases}
\vec P\left(\vec x,t\right)=Re\left\{2\int_{0}^{+\infty}
{\frac{n^2\left(\omega,\vec x,t\right)-1}{4\pi}\,\vec {\hat
E}\left(\omega,\vec
 x\right)}e^{i\omega t}d\omega\right\}-Re\left\{2\int_{0}^{+\infty}
{\kappa\left(\omega,\vec x,t\right)\vec {\hat D}\left(\omega,\vec
x\right)}e^{i\omega t}d\omega\right\},\\
\vec M\left(\vec x,t\right)=Re\left\{2\int_{0}^{+\infty}
{\kappa\left(\omega,\vec x,t\right)\vec {\hat H}\left(\omega,\vec
x\right)}e^{i\omega t}d\omega\right\},
\end{cases}
\end{equation}
where $\vec {\hat E}(\omega,\vec x)$, $\vec {\hat B}(\omega,\vec
x)$, $\vec {\hat D}(\omega,\vec x)$ and $\vec {\hat H}(\omega,\vec
x)$ are Fourier's Transforms by time of $\vec E(\vec x,t)$, $\vec
B(\vec x,t)$, $\vec D(\vec x,t)$ and $\vec H(\vec x,t)$, given by
the second equalities of
\er{MaxVacFullPPNmmmffffffhhtygghGGGGyuhggghghghffggggjj11jhhhjj11hgjgj11ghfh11},
\er{MaxVacFullPPNmmmffffffhhtygghGGGGyuhggghghghffggggjj11jhhhjj11ihhgjjj11mbbmbm11},
\er{MaxVacFullPPNmmmffffffhhtygghGGGGyuhggghghghffggggjj11jhhhjj11hgjgj11ghfh11D}
and
\er{MaxVacFullPPNmmmffffffhhtygghGGGGyuhggghghghffggggjj11jhhhjj11ihhgjjj11mbbmbm11H}
and the quantities $n\left(\omega,\vec x,t\right)$ and
$\kappa\left(\omega,\vec x,t\right)$ in
\er{EBDHTrans444knbuihuigGGjj11jgggggjj11hhhjjj11gvghhghgjj11jgjggj11hghffh11}
are assumed to be \underline{complex} in general and assumed to
depend on $\omega$ in addition to the dependence on $(\vec x,t)$.

We would like to obtain the law being an analog of
\er{EBDHTrans444knbuihuigGGjj11jgggggjj11hhhjjj11gvghhghgjj11jgjggj11hghffh11}
in the case of moving medium (i.e. $\vec u\neq0$), which is
invariant under the change of inertial or non-inertial cartesian
coordinate system. In addition to the invariance under the change of
cartesian coordinate systems and the equivalence to
\er{EBDHTrans444knbuihuigGGjj11jgggggjj11hhhjjj11gvghhghgjj11jgjggj11hghffh11}
in the particular case $\vec u\equiv 0$, we also need to assume that
in the case of an arbitrary moving transparent medium without
dispersion our law is equivalent to \er{EBDHTrans444knbuihuigGG}.

Then, in some cartesian coordinate system consider a motion of some
continuum medium occupying a region $\Omega\subset\mathbb{R}^3$ at
some fixed instant of time $t=t_0$ and having the velocity field
$\vec u:=\vec u(\vec x,t)$. Next, as before, let $\vec r(t,\vec
y):\mathbb{R}\times\Omega\to\mathbb{R}^3$ be a solution of
\er{hjgghhghhffhkk} i.e. it satisfies the following initial value
problem for an ordinary differential equation:
\begin{equation}\label{hjgghhghhffhkkjj11}
\begin{cases}
\frac{\partial\vec r}{\partial t}(t,\vec y)=\vec u\left(\vec
r(t,\vec y),t\right)\quad\quad\forall t\in\mathbb{R},\;\;\forall\vec
y\in\Omega\\
\vec r(t_0,\vec y)=\vec y\quad\quad\forall\vec y\in\Omega.
\end{cases}
\end{equation}
Then, clearly $\vec x=\vec r(t,\vec y)$ stands for the spatial
coordinates at the instant of time $t$ of the parcel of continuum,
having initial coordinates $\vec y$. Thus
by \er{hjgghhghhffhkkihiyjhkpkpkbb} we deduce that
$det\left\{d_{\vec y}\vec r(t,\vec y)\right\}\neq 0$ for every
instant of time $t$ and so, for the given instant of time $t$ the
mapping $\vec x=\vec r(t,\vec y)$ is locally invertible i.e. the
equation $\vec x=\vec r(t,\vec y)$ can be resolved in $\vec y$. Thus
there exists a regular mapping $\vec y=\vec f(\vec x,t)$, such that
\begin{equation}\label{hjgghhghhffhkkihiyjhjkhjhjhjj11}
\vec f\left(\vec r(t,\vec y),t\right)=\vec y\quad\forall \vec
y\in\Omega\quad\quad\text{and}\quad\quad\vec r\left(t,\vec f(\vec
x,t)\right)=\vec x.
\end{equation}
Then, clearly $\vec y=\vec f(\vec x,t)$ stands for the initial
coordinates at the time $t_0$ of the parcel of continuum, having
coordinates $\vec x$ at the instant of time $t$. Next $\vec f(\vec
x,t)$ satisfies:
\begin{equation}\label{hjgghhghhffhkkmkljjnnnnjj11}
\begin{cases}
\frac{\partial\vec f}{\partial t}\left(\vec x,t\right)+d_{\vec
x}\vec f\left(\vec x,t\right)\cdot\vec u\left(\vec x,t\right)=0
\\
\vec f(\vec x,t_0)=\vec x.
\end{cases}
\end{equation}
Next consider,
\begin{equation}\label{EBDHTrans444knbuihuigGGjj11ghthgtjkjj11}
\begin{cases}
\vec E^*(t,\vec y):=\left\{d_{\vec y}\vec r\left(t,\vec
y\right)\right\}^T\cdot\left(\vec E\left(\vec r(t,\vec
y),t\right)+\frac{1}{c}\,\vec u\left(\vec r(t,\vec y),t\right)\times
\vec B\left(\vec r(t,\vec y),t\right)\right)\quad\quad\forall
t\in\mathbb{R},\;\;\forall\vec
y\in\Omega,\\
\vec B^*(t,\vec y):=\left\{d_{\vec y}\vec r\left(t,\vec
y\right)\right\}^T\cdot\vec B\left(\vec r(t,\vec
y),t\right)\quad\quad\forall t\in\mathbb{R},\;\;\forall\vec
y\in\Omega,\\
\vec D^*(t,\vec y):=\left\{d_{\vec y}\vec r\left(t,\vec
y\right)\right\}^T\cdot\vec D\left(\vec r(t,\vec
y),t\right)\quad\quad\forall t\in\mathbb{R},\;\;\forall\vec
y\in\Omega,\\
\vec H^*(t,\vec y):=\left\{d_{\vec y}\vec r\left(t,\vec
y\right)\right\}^T\cdot\left(\vec H\left(\vec r(t,\vec
y),t\right)-\frac{1}{c}\,\vec u\left(\vec r(t,\vec y),t\right)\times
\vec D\left(\vec r(t,\vec y),t\right)\right)\quad\quad\forall
t\in\mathbb{R},\;\;\forall\vec y\in\Omega,
\end{cases}
\end{equation}
where $\vec r(t,\vec y)$ is given by \er{hjgghhghhffhkkjj11}.
Then, since $\vec E^*$, $\vec B^*$, $\vec D^*$ and $\vec H^*$ are
real vectors, we can write them as a Fourier's Transform on the time
variable:
\begin{align}
\label{MaxVacFullPPNmmmffffffhhtygghGGGGyuhggghghghffggggjj11jhhhjj11}
\vec E^*(t,\vec y)=Re\left\{2\int_{0}^{+\infty}  {\vec {\hat
E^*}(\tilde\omega,\vec y)}e^{i\tilde \omega
t}d\tilde\omega\right\}\quad\text{where}\quad {\vec {\hat
E^*}(\tilde\omega,\vec y)}:=\frac{1}{2\pi}\int_{-\infty}^{+\infty}
\vec E^*(t,\vec y)e^{-i\tilde \omega t}dt\,,\\
\label{MaxVacFullPPNmmmffffffhhtygghGGGGyuhggghghghffggggjj11jhhhjj11ihhgjjj11}
\vec B^*(t,\vec y)=Re\left\{2\int_{0}^{+\infty}  {\vec {\hat
B^*}(\tilde\omega,\vec y)}e^{i\tilde \omega
t}d\tilde\omega\right\}\quad\text{where}\quad {\vec {\hat
B^*}(\tilde\omega,\vec y)}:=\frac{1}{2\pi}\int_{-\infty}^{+\infty}
\vec B^*(t,\vec y)e^{-i\tilde \omega t}dt\,,\\
\label{MaxVacFullPPNmmmffffffhhtygghGGGGyuhggghghghffggggjj11jhhhjj11D}
\vec D^*(t,\vec y)=Re\left\{2\int_{0}^{+\infty}  {\vec {\hat
D^*}(\tilde\omega,\vec y)}e^{i\tilde \omega
t}d\tilde\omega\right\}\quad\text{where}\quad {\vec {\hat
D^*}(\tilde\omega,\vec y)}:=\frac{1}{2\pi}\int_{-\infty}^{+\infty}
\vec D^*(t,\vec y)e^{-i\tilde \omega t}dt\,,\\
\label{MaxVacFullPPNmmmffffffhhtygghGGGGyuhggghghghffggggjj11jhhhjj11ihhgjjj11H}
\vec H^*(t,\vec y)=Re\left\{2\int_{0}^{+\infty}  {\vec {\hat
H^*}(\tilde\omega,\vec y)}e^{i\tilde \omega
t}d\tilde\omega\right\}\quad\text{where}\quad {\vec {\hat
H^*}(\tilde\omega,\vec y)}:=\frac{1}{2\pi}\int_{-\infty}^{+\infty}
\vec H^*(t,\vec y)e^{-i\tilde \omega t}dt\,.
\end{align}
Then we write the generalization of
\er{EBDHTrans444knbuihuigGGjj11jgggggjj11hhhjjj11gvghhghgjj11jgjggj11hghffh11}
as:
\begin{multline}\label{EBDHTrans444knbuihuigGGjj11jgggggjj11hhhjjj11}
\vec P\left(\vec r(t,\vec y),t\right)= Re\left\{2\int_{0}^{+\infty}
{\frac{n^2\left(\tilde\omega,\vec r(t,\vec
y),t\right)-1}{4\pi}\,\left(\left(\left\{d_{\vec y}\vec
r\left(t,\vec y\right)\right\}^T\right)^{-1}\cdot\vec {\hat
E^*}\left(\omega,\vec
 x\right)\right)}e^{i\omega t}d\omega\right\}\\-Re\left\{2\int_{0}^{+\infty}
{\kappa\left(\tilde\omega,\vec r(t,\vec
y),t\right)\left(\left(\left\{d_{\vec y}\vec r\left(t,\vec
y\right)\right\}^T\right)^{-1}\cdot\vec {\hat D^*}\left(\omega,\vec
x\right)\right)}e^{i\omega t}d\omega\right\}\\
\quad\text{and}\quad\quad\quad \vec M\left(\vec r(t,\vec
y),t\right)+\frac{1}{c}\vec u\left(\vec
r(t,\vec y),t\right)\times \vec P\left(\vec r(t,\vec y),t\right)=\\
Re\left\{2\int_{0}^{+\infty} {\kappa\left(\tilde\omega,\vec r(t,\vec
y),t\right)\left(\left(\left\{d_{\vec y}\vec r\left(t,\vec
y\right)\right\}^T\right)^{-1}\cdot\vec {\hat H^*}(\tilde\omega,\vec
y)\right)}e^{i\tilde \omega t}d\tilde\omega\right\},
\end{multline}
i.e.
\begin{multline}\label{EBDHTrans444knbuihuigGGjj11jgggggjj11hhhjjj11gvghhghgjj11}
\vec P\left(\vec x,t\right)=Re\left\{2\int_{0}^{+\infty}
{\frac{n^2\left(\tilde\omega,\vec
x,t\right)-1}{4\pi}\,\left(\left\{d_{\vec x}\vec f\left(\vec
x,t\right)\right\}^T\cdot\vec {\hat E^*}\left(\tilde\omega,\vec
f\left(\vec x,t\right)\right)\right)}e^{i\tilde \omega
t}d\tilde\omega\right\}\\-Re\left\{2\int_{0}^{+\infty}
{\kappa\left(\tilde\omega,\vec x,t\right)\left(\left\{d_{\vec x}\vec
f\left(\vec x,t\right)\right\}^T\cdot\vec {\hat
D^*}\left(\tilde\omega,\vec f\left(\vec
x,t\right)\right)\right)}e^{i\tilde \omega t}d\tilde\omega\right\}\quad\quad\text{and} \\
\vec M\left(\vec x,t\right)+\frac{1}{c}\vec u(\vec x,t)\times \vec
P(\vec x,t)=Re\left\{2\int_{0}^{+\infty}
{\kappa\left(\tilde\omega,\vec x,t\right)\left(\left\{d_{\vec x}\vec
f\left(\vec x,t\right)\right\}^T\cdot\vec {\hat
H^*}\left(\tilde\omega,\vec f\left(\vec
x,t\right)\right)\right)}e^{i\tilde \omega t}d\tilde\omega\right\}.
\end{multline}
The quantities $n\left(\tilde\omega,\vec x,t\right)$ and
$\kappa\left(\tilde\omega,\vec x,t\right)$ in
\er{EBDHTrans444knbuihuigGGjj11jgggggjj11hhhjjj11} and
\er{EBDHTrans444knbuihuigGGjj11jgggggjj11hhhjjj11gvghhghgjj11} are
assumed to be \underline{complex} in general and assumed to depend
on $\tilde\omega$ in addition to the dependence on $(\vec x,t)$.

In particular, in the case $\vec u\equiv 0$ we clearly have $\vec
r(t,\vec y)\equiv\vec y$, $\vec f\left(\vec x,t\right)\equiv\vec x$
and $\vec E^*(t,\vec x)=\vec E(\vec x,t)$, $\vec B^*(t,\vec x)=\vec
B(\vec x,t)$, $\vec D^*(t,\vec x)=\vec D(\vec x,t)$, $\vec
H^*(t,\vec x)=\vec H(\vec x,t)$ and therefore,
\er{EBDHTrans444knbuihuigGGjj11jgggggjj11hhhjjj11} and
\er{EBDHTrans444knbuihuigGGjj11jgggggjj11hhhjjj11gvghhghgjj11}
coincide with
\er{EBDHTrans444knbuihuigGGjj11jgggggjj11hhhjjj11gvghhghgjj11jgjggj11hghffh11}.

On the other hand, in the case of an arbitrary moving transparent
medium without dispersion i.e. in the case where we assume that
$n:=n\left(\vec x,t\right)$ and $\kappa:=\kappa\left(\vec
x,t\right)$ are independent of the argument $\tilde\omega$ and
moreover we assume them to be \underline{real}, by the properties of
Fourier's Transform we rewrite
\er{EBDHTrans444knbuihuigGGjj11jgggggjj11hhhjjj11} as:
\begin{multline}\label{EBDHTrans444knbuihuigGGjj11jgggggjj11hhhjjj11jhjgj11}
\vec P\left(\vec r(t,\vec y),t\right)= \frac{n^2\left(\vec r(t,\vec
y),t\right)-1}{4\pi}\,\left(\left\{d_{\vec y}\vec r\left(t,\vec
y\right)\right\}^T\right)^{-1}\cdot\, Re\left\{2\int_{0}^{+\infty}
{\vec {\hat E^*}(\tilde\omega,\vec y)}e^{i\tilde \omega
t}d\tilde\omega\right\}\\-\kappa\left(\vec r(t,\vec
y),t\right)\left(\left\{d_{\vec y}\vec r\left(t,\vec
y\right)\right\}^T\right)^{-1}\cdot\,Re\left\{2\int_{0}^{+\infty}
{\vec {\hat D^*}(\tilde\omega,\vec y)}e^{i\tilde \omega
t}d\tilde\omega\right\}\\=\gamma\left(\vec r(t,\vec
y),t\right)\left(\left\{d_{\vec y}\vec r\left(t,\vec
y\right)\right\}^T\right)^{-1}\cdot\vec E^*(t,\vec
y)-\kappa\left(\vec r(t,\vec y),t\right)\left(\left\{d_{\vec y}\vec
r\left(t,\vec
y\right)\right\}^T\right)^{-1}\cdot\vec D^*(t,\vec y)\quad\quad\text{and}\\
\vec M\left(\vec r(t,\vec y),t\right)+\frac{1}{c}\vec u\left(\vec
r(t,\vec y),t\right)\times \vec P\left(\vec r(t,\vec y),t\right)=\\
\kappa\left(\vec r(t,\vec y),t\right)\left(\left\{d_{\vec y}\vec
r\left(t,\vec
y\right)\right\}^T\right)^{-1}\cdot\,Re\left\{2\int_{0}^{+\infty}
{\vec {\hat H^*}(\tilde\omega,\vec y)}e^{i\tilde \omega
t}d\tilde\omega\right\}\\=\kappa\left(\vec r(t,\vec
y),t\right)\left(\left\{d_{\vec y}\vec r\left(t,\vec
y\right)\right\}^T\right)^{-1}\cdot\vec H^*(t,\vec y).
\end{multline}
Thus, inserting  \er{EBDHTrans444knbuihuigGGjj11ghthgtjkjj11} into
\er{EBDHTrans444knbuihuigGGjj11jgggggjj11hhhjjj11jhjgj11} we deduce
\begin{multline}\label{EBDHTrans444knbuihuigGGjj11jgggggjj11hhhjjj11jhjgj11jggjgg11}
\vec P\left(\vec r(t,\vec y),t\right)= \frac{n^2\left(\vec r(t,\vec
y),t\right)-1}{4\pi}\,\left(\vec E\left(\vec r(t,\vec
y),t\right)+\frac{1}{c}\,\vec u\left(\vec r(t,\vec y),t\right)\times
\vec B\left(\vec r(t,\vec y),t\right)\right)\\-\kappa\left(\vec
r(t,\vec y),t\right)\vec D\left(\vec r(t,\vec y),t\right)
\quad\quad\text{and}\quad\quad \\
\vec M\left(\vec r(t,\vec y),t\right)+\frac{1}{c}\vec u\left(\vec
r(t,\vec y),t\right)\times \vec P\left(\vec r(t,\vec y),t\right)=
\\
\kappa\left(\vec r(t,\vec y),t\right)\left(\vec H\left(\vec
r(t,\vec y),t\right)-\frac{1}{c}\,\vec u\left(\vec r(t,\vec
y),t\right)\times \vec D\left(\vec r(t,\vec y),t\right)\right).
\end{multline}
I.e.
\begin{equation}\label{EBDHTrans444knbuihuigGGjj11gjjggjg11ghhfgg11}
\begin{cases}
\vec P(\vec x,t)=\frac{n^2\left(\vec x,t\right)-1}{4\pi}\,\left(\vec
E(\vec x,t)+\frac{1}{c}\,\vec u(\vec x,t)\times \vec B(\vec
x,t)\right)-\kappa(\vec x,t)\,\vec D(\vec
x,t),\\
\vec M(\vec x,t)=\kappa(\vec x,t)\,\left(\vec H(\vec
x,t)-\frac{1}{c}\,\vec u(\vec x,t)\times \vec D(\vec
x,t)\right)-\frac{1}{c}\vec u\left(\vec x,t\right)\times \vec
P\left(\vec x,t\right),
\end{cases}
\end{equation}
consistently with \er{EBDHTrans444knbuihuigGG}.

Next, as before, assume that the change of some non-inertial
cartesian system $(*)$ of coordinates to another cartesian system
$(**)$ of coordinates is of the form:
\begin{equation}\label{jjjjkkkk11}
\begin{cases}
\vec x'=A(t)\cdot\vec x+\vec z(t),\\
t'=t,
\end{cases}
\end{equation}
where $A(t)\in SO(3)$ is a rotation. Then, by \er{jjjjkkkkkhhhhjhjh}
the law of transformation of the Lagrangian coordinates $(\vec
y,t)\to(\vec y',t')$, consistent with \er{jjjjkkkk11}, is the
following:
\begin{equation}\label{jjjjkkkkkhhhhjhjh11}
\begin{cases}
\vec y'=A(t_0)\cdot\vec y+\vec z(t_0),\\
t'=t.
\end{cases}
\end{equation}
Moreover, by \er{hjgjgjgkllujkjjjkhhjh} we have
\begin{equation}\label{hjgjgjgkllujkjjjkhhjh11}
\vec r'(t',\vec y')=\vec r'\left(t,A(t_0)\cdot\vec y+\vec
z(t_0)\right)=A(t)\cdot\vec r(t,\vec y)+\vec z(t),
\end{equation}
and by \er{hjgjgjgkllujkjjjkhhjhljjk} for inverse mappings we have
\begin{equation}\label{hjgjgjgkllujkjjjkhhjhljjkjhhjh11}
\vec f'(\vec x',t')=\vec f'\left(A(t)\cdot\vec x+\vec
z(t),t\right)=A(t_0)\cdot\vec f(\vec x,t)+\vec z(t_0).
\end{equation}
In particular,
\begin{equation}\label{hjgjgjgkllujkjjjkhhjh11jhgjjggjgh11}
d_{\vec y'}\vec r'(t',\vec y')=A(t)\cdot d_{\vec y}\vec r(t,\vec
y)\cdot\left\{A(t_0)\right\}^{-1}=A(t)\cdot d_{\vec y}\vec r(t,\vec
y)\cdot A^{T}(t_0),
\end{equation}
and
\begin{equation}\label{hjgjgjgkllujkjjjkhhjhljjkjhhjh11hkkhjhgj11}
d_{\vec x'}\vec f'(\vec x',t')=A(t_0)\cdot d_{\vec x}\vec f(\vec
x,t)\cdot\left\{A(t)\right\}^{-1}=A(t_0)\cdot d_{\vec x}\vec f(\vec
x,t)\cdot A^{T}(t).
\end{equation}
However, by \er{guigikvbvbggjklhjkkgjgGGGG} and the fifth equation
in \er{guigikvbvbGG} we have
\begin{equation}\label{guigikvbvbggjklhjkkgjgGGGGjjhjhhjhooiuui}
\begin{cases}
\vec D'(\vec x',t')=A(t)\cdot \vec D(\vec x,t)\\
\vec B'(\vec x',t')=A(t)\cdot\vec B(\vec x,t)\\
\left(\vec E'(\vec x',t')+\frac{1}{c}\,\vec u'(\vec x',t')\times
\vec B'(\vec x',t')\right)= A(t)\cdot\left(\vec
E\left(x,t\right)+\frac{1}{c}\,\vec u\left(x,t\right)\times \vec
B\left(x,t\right)\right)\\
\left(\vec H'(\vec x',t')-\frac{1}{c}\,\vec u'(\vec x',t')\times
\vec D'(\vec x',t')\right)= A(t)\cdot\left(\vec
H\left(x,t\right)-\frac{1}{c}\,\vec u\left(x,t\right)\times \vec
D\left(x,t\right)\right).
\end{cases}
\end{equation}
In, particular, by inserting \er{jjjjkkkkkhhhhjhjh11},
\er{hjgjgjgkllujkjjjkhhjh11} and
\er{hjgjgjgkllujkjjjkhhjh11jhgjjggjgh11} into
\er{EBDHTrans444knbuihuigGGjj11ghthgtjkjj11} and using
\er{guigikvbvbggjklhjkkgjgGGGGjjhjhhjhooiuui} we deduce:
\begin{equation}\label{EBDHTrans444knbuihuigGGjj11ghthgtjkjj11hhhgm11jjhkhk11}
\begin{cases}
{\vec E^*}'(t',\vec y')=A(t_0)\cdot\vec
E^*(t,\vec y)\\
{\vec B^*}'(t',\vec y')=A(t_0)\cdot\vec B^*(t,\vec y)\\
{\vec D^*}'(t',\vec y')=A(t_0)\cdot\vec
D^*(t,\vec y)\\
{\vec H^*}'(t',\vec y')=A(t_0)\cdot\vec H^*(t,\vec y).
\end{cases}
\end{equation}
Thus inserting
\er{EBDHTrans444knbuihuigGGjj11ghthgtjkjj11hhhgm11jjhkhk11} into
\er{MaxVacFullPPNmmmffffffhhtygghGGGGyuhggghghghffggggjj11jhhhjj11}
and
\er{MaxVacFullPPNmmmffffffhhtygghGGGGyuhggghghghffggggjj11jhhhjj11ihhgjjj11}
gives
\begin{equation}\label{EBDHTrans444knbuihuigGGjj11ghthgtjkjj11hhhgm11jjhkhk11jggj11}
\begin{cases}
\hat {\vec{E^*}'}(\tilde\omega',\vec y')=A(t_0)\cdot\vec {\hat
E^*}(\tilde\omega,\vec y)\\
\hat {\vec{B^*}'}(\tilde\omega',\vec y')=A(t_0)\cdot\vec {\hat
B^*}(\tilde\omega,\vec y)\\
\hat {\vec{D^*}'}(\tilde\omega',\vec y')=A(t_0)\cdot\vec {\hat
D^*}(\tilde\omega,\vec y)\\
\hat {\vec{H^*}'}(\tilde\omega',\vec y')=A(t_0)\cdot\vec {\hat
H^*}(\tilde\omega,\vec y),
\end{cases}
\end{equation}
provided that we have
\begin{equation}\label{hjgjgjgkllujkjjjkhhjhljjkjhhjh11hkkhjhgj11jkjkjkgjg11}
\tilde\omega'=\tilde\omega.
\end{equation}
Then, inserting \er{hjgjgjgkllujkjjjkhhjhljjkjhhjh11} and
\er{hjgjgjgkllujkjjjkhhjhljjkjhhjh11hkkhjhgj11} into
\er{EBDHTrans444knbuihuigGGjj11ghthgtjkjj11hhhgm11jjhkhk11jggj11} we
deduce
\begin{equation}\label{EBDHTrans444knbuihuigGGjj11jgggggjj11hhhjjj11gvghhghgjjjhkhkhjh11}
\begin{cases}
\left\{d_{\vec x'}\vec f'\left(\vec x',t'\right)\right\}^T\cdot \hat
{\vec{E^*}'}\left(\tilde\omega',\vec f'\left(\vec
x',t'\right)\right)=A(t)\cdot\left(\left\{d_{\vec x}\vec f\left(\vec
x,t\right)\right\}^T\cdot\vec {\hat E^*}\left(\tilde\omega,\vec
f\left(\vec
x,t\right)\right)\right),\\
\left\{d_{\vec x'}\vec f'\left(\vec x',t'\right)\right\}^T\cdot \hat
{\vec{B^*}'}\left(\tilde\omega',\vec f'\left(\vec
x',t'\right)\right)= A(t)\cdot\left(\left\{d_{\vec x}\vec
f\left(\vec x,t\right)\right\}^T\cdot\vec {\hat
B^*}\left(\tilde\omega,\vec f\left(\vec x,t\right)\right)\right)\\
\left\{d_{\vec x'}\vec f'\left(\vec x',t'\right)\right\}^T\cdot \hat
{\vec{D^*}'}\left(\tilde\omega',\vec f'\left(\vec
x',t'\right)\right)=A(t)\cdot\left(\left\{d_{\vec x}\vec f\left(\vec
x,t\right)\right\}^T\cdot\vec {\hat D^*}\left(\tilde\omega,\vec
f\left(\vec
x,t\right)\right)\right),\\
\left\{d_{\vec x'}\vec f'\left(\vec x',t'\right)\right\}^T\cdot \hat
{\vec{H^*}'}\left(\tilde\omega',\vec f'\left(\vec
x',t'\right)\right)= A(t)\cdot\left(\left\{d_{\vec x}\vec
f\left(\vec x,t\right)\right\}^T\cdot\vec {\hat
H^*}\left(\tilde\omega,\vec f\left(\vec x,t\right)\right)\right).
\end{cases}
\end{equation}
Thus, inserting
\er{EBDHTrans444knbuihuigGGjj11jgggggjj11hhhjjj11gvghhghgjjjhkhkhjh11}
into \er{EBDHTrans444knbuihuigGGjj11jgggggjj11hhhjjj11gvghhghgjj11}
we obtain
\begin{equation}\label{EBDHTrans444knbuihuigGGjj11jgggggjj11hhhjjj11gvghhghgjj11hkhhkjhjh11}
\begin{cases}
\vec P'\left(\vec x',t'\right)=A(t)\cdot\vec P\left(\vec x,t\right)\\
\vec M'\left(\vec x',t'\right)=A(t)\cdot\vec M\left(\vec
x,t\right)-\frac{1}{c}\,\left(A'(t)\cdot\vec x+\vec
w(t)\right)\times \left(A(t)\cdot\vec P\right),
\end{cases}
\end{equation}
provided that we have
\begin{equation}\label{hjgjgjgkllujkjjjkhhjhljjkjhhjh11hkkhjhgj11jkjkjkgjg11hkhk4hjhjh11}
\tilde\omega'=\tilde\omega,\quad n'\left(\tilde\omega',\vec
x',t'\right)=n\left(\tilde\omega,\vec x,t\right)\quad\text{and}\quad
\kappa'\left(\tilde\omega',\vec
x',t'\right)=\kappa\left(\tilde\omega,\vec x,t\right).
\end{equation}
On the other hand,
\er{EBDHTrans444knbuihuigGGjj11jgggggjj11hhhjjj11gvghhghgjj11hkhhkjhjh11}
is fully consistent with the last two equalities in
\er{guigikvbvbGG} and therefore, the dispersion law, in the forms of
\er{EBDHTrans444knbuihuigGGjj11jgggggjj11hhhjjj11} or equivalently
\er{EBDHTrans444knbuihuigGGjj11jgggggjj11hhhjjj11gvghhghgjj11}, is
invariant under the change of inertial or non-inertial cartesian
coordinate system, provided we have
\er{hjgjgjgkllujkjjjkhhjhljjkjhhjh11hkkhjhgj11jkjkjkgjg11hkhk4hjhjh11}.

Next we will prove that the right hand side of
\er{EBDHTrans444knbuihuigGGjj11jgggggjj11hhhjjj11} or
\er{EBDHTrans444knbuihuigGGjj11jgggggjj11hhhjjj11gvghhghgjj11} is
independent of the initial instant of time $t_0$ in
\er{hjgghhghhffhkkjj11}. Indeed let $t_1\in\mathbb{R}$ be another
choice of initial instant of time, let $\vec r(t,\vec y)$ be given
by \er{hjgghhghhffhkkjj11}:
\begin{equation}\label{hjgghhghhffhkkjj11jj11}
\begin{cases}
\frac{\partial\vec r}{\partial t}(t,\vec y)=\vec u\left(\vec
r(t,\vec y),t\right)\\
\vec r(t_0,\vec y)=\vec y,
\end{cases}
\end{equation}
and let $\vec r_1(t,\vec y)$ be given by the following:
\begin{equation}\label{hjgghhghhffhkkjj11jj11jgjgjg11}
\begin{cases}
\frac{\partial\vec r_1}{\partial t}(t,\vec y)=\vec u\left(\vec
r_1(t,\vec y),t\right)\\
\vec r_1(t_1,\vec y)=\vec y.
\end{cases}
\end{equation}
Then considering $\vec r_2(t,\vec y)$ defined by
\begin{equation}\label{hjgghhghhffhkkjj11jj11gghgh11khjhjh11}
\vec r_2(t,\vec y)=\vec r\left(t,\vec h(\vec
y)\right)\quad\text{where}\quad \vec h(\vec y):=\vec r_1(t_0,\vec
y),
\end{equation}
gives
\begin{equation}\label{hjgghhghhffhkkjj11jj11jhjhjhjh11}
\begin{cases}
\frac{\partial\vec r_2}{\partial t}(t,\vec y)=\vec u\left(\vec
r_2(t,\vec y),t\right)\\
\vec r_2(t_0,\vec y)=\vec r\left(t_0,\vec r_1(t_0,\vec
y)\right)=\vec r_1(t_0,\vec y).
\end{cases}
\end{equation}
Thus, by \er{hjgghhghhffhkkjj11jj11jhjhjhjh11}, using the uniqueness
of solution to the ordinary differential equation we deduce
\begin{equation}\label{hjgghhghhffhkkjj11hgjggjgjf11}
\vec r_1(t,\vec y)=\vec r_2(t,\vec y)=\vec r\left(t,\vec h(\vec
y)\right)\quad\quad\forall \,t,\vec y.
\end{equation}
Thus considering,
\begin{equation}\label{EBDHTrans444knbuihuigGGjj11ghthgtjkjj11jhjgh11}
\begin{cases}
\vec D^*_1(t,\vec y):=\left\{d_{\vec y}\vec r_1\left(t,\vec
y\right)\right\}^T\cdot\vec D\left(\vec r_1(t,\vec y),t\right)\\
\vec B^*_1(t,\vec y):=\left\{d_{\vec y}\vec r_1\left(t,\vec
y\right)\right\}^T\cdot\vec B\left(\vec r_1(t,\vec y),t\right)\\
\vec E^*_1(t,\vec y):=\left\{d_{\vec y}\vec r_1\left(t,\vec
y\right)\right\}^T\cdot\left(\vec E\left(\vec r_1(t,\vec
y),t\right)+\frac{1}{c}\,\vec u\left(\vec r_1(t,\vec
y),t\right)\times \vec B\left(\vec r_1(t,\vec
y),t\right)\right)\\
\vec H^*_1(t,\vec y):=\left\{d_{\vec y}\vec r_1\left(t,\vec
y\right)\right\}^T\cdot\left(\vec H\left(\vec r_1(t,\vec
y),t\right)-\frac{1}{c}\,\vec u\left(\vec r_1(t,\vec
y),t\right)\times \vec D\left(\vec r_1(t,\vec y),t\right)\right),
\end{cases}
\end{equation}
where $\vec r_1(t,\vec y)$ is given by
\er{hjgghhghhffhkkjj11jj11jgjgjg11}, by
\er{hjgghhghhffhkkjj11hgjggjgjf11} we deduce
\begin{multline}\label{EBDHTrans444knbuihuigGGjj11ghthgtjkjj11jhjgh11jj11}
\vec E^*_1(t,\vec y)=\left\{d_{\vec y}\vec h\left(\vec
y\right)\right\}^T\cdot\vec E^*\left(t,\vec h(\vec
y)\right)\,,\quad\quad\vec B^*_1(t,\vec y)=\left\{d_{\vec y}\vec
h\left(\vec y\right)\right\}^T\cdot\vec B^*\left(t,\vec
h(\vec y)\right)\,,\\
\vec D^*_1(t,\vec y)=\left\{d_{\vec y}\vec h\left(\vec
y\right)\right\}^T\cdot\vec D^*\left(t,\vec h(\vec
y)\right)\quad\text{and}\quad\vec H^*_1(t,\vec y)=\left\{d_{\vec
y}\vec h\left(\vec y\right)\right\}^T\cdot\vec H^*\left(t,\vec
h(\vec y)\right).
\end{multline}
Then defining
\begin{multline}\label{MaxVacFullPPNmmmffffffhhtygghGGGGyuhggghghghffggggjj11jhhhjj11jjhj11bbbvvvvbbc11hkhk11}
{\vec {\hat E^*}}_1(\tilde\omega,\vec
y):=\frac{1}{2\pi}\int_{-\infty}^{+\infty} \vec E^*_1(t,\vec
y)e^{-i\tilde \omega t}dt\,,\quad\quad {\vec {\hat
D^*}}_1(\tilde\omega,\vec y):=\frac{1}{2\pi}\int_{-\infty}^{+\infty}
\vec D^*_1(t,\vec y)e^{-i\tilde \omega t}dt\,,\\
{\vec {\hat B^*}}_1(\tilde\omega,\vec
y):=\frac{1}{2\pi}\int_{-\infty}^{+\infty} \vec B^*_1(t,\vec
y)e^{-i\tilde \omega t}dt\quad\text{and}\quad {\vec {\hat
H^*}}_1(\tilde\omega,\vec y):=\frac{1}{2\pi}\int_{-\infty}^{+\infty}
\vec H^*_1(t,\vec y)e^{-i\tilde \omega t}dt,
\end{multline}
we deduce
\begin{multline}\label{MaxVacFullPPNmmmffffffhhtygghGGGGyuhggghghghffggggjj11jhhhjj11jjhj11bbbvvvvbbc11jkkhkfhfh11}
{\vec {\hat E^*}}_1(\tilde\omega,\vec y)=\left\{d_{\vec y}\vec
h\left(\vec y\right)\right\}^T\cdot{\vec {\hat
E^*}}\left(\tilde\omega,\vec h(\vec y)\right)\,,\quad\quad {\vec
{\hat B^*}}_1(\tilde\omega,\vec y)=\left\{d_{\vec y}\vec h\left(\vec
y\right)\right\}^T\cdot{\vec {\hat
B^*}}\left(\tilde\omega,\vec h(\vec y)\right)\,,\\
{\vec {\hat D^*}}_1(\tilde\omega,\vec y)=\left\{d_{\vec y}\vec
h\left(\vec y\right)\right\}^T\cdot{\vec {\hat
D^*}}\left(\tilde\omega,\vec h(\vec y)\right)\quad\text{and}\quad
{\vec {\hat H^*}}_1(\tilde\omega,\vec y)=\left\{d_{\vec y}\vec
h\left(\vec y\right)\right\}^T\cdot{\vec {\hat
H^*}}\left(\tilde\omega,\vec h(\vec y)\right).
\end{multline}
Thus, we write the analog of
\er{EBDHTrans444knbuihuigGGjj11jgggggjj11hhhjjj11} as:
\begin{multline}\label{EBDHTrans444knbuihuigGGjj11jgggggjj11hhhjjj11jhjgj11jkhkjh11}
\vec P\left(\vec r_1(t,\vec y),t\right)=\vec P\left(\vec
r\left(t,\vec h(\vec y)\right),t\right)=\\
Re\left\{2\int_{0}^{+\infty} \frac{n^2\left(\tilde\omega,\vec
r\left(t,\vec h(\vec
y)\right),t\right)-1}{4\pi}\,\left(\left\{d_{\vec y}\vec
r\left(t,\vec h(\vec y)\right)\cdot d_{\vec y}\vec h(\vec
y)\right\}^T\right)^{-1}\cdot\left\{d_{\vec y}\vec h\left(\vec
y\right)\right\}^T\cdot\vec {\hat E^*}\left(\tilde\omega,\vec
h(y)\right)e^{i\tilde \omega
t}d\tilde\omega\right\}\\-Re\left\{2\int_{0}^{+\infty}
\kappa\left(\tilde\omega,\vec r\left(t,\vec h(\vec
y)\right),t\right)\left(\left\{d_{\vec y}\vec r\left(t,\vec h(\vec
y)\right)\cdot d_{\vec y}\vec h(\vec
y)\right\}^T\right)^{-1}\cdot\left\{d_{\vec y}\vec h\left(\vec
y\right)\right\}^T\cdot\vec {\hat D^*}\left(\tilde\omega,\vec
h(y)\right)e^{i\tilde \omega t}d\tilde\omega\right\}
\\= Re\left\{2\int_{0}^{+\infty}
\frac{n^2\left(\tilde\omega,\vec r_1(t,\vec
y),t\right)-1}{4\pi}\,\left(\left\{d_{\vec y}\vec r_1\left(t,\vec
y\right)\right\}^T\right)^{-1}\cdot{\vec {\hat
E^*}}_1(\tilde\omega,\vec y)e^{i\tilde \omega
t}d\tilde\omega\right\}\\-Re\left\{2\int_{0}^{+\infty}
\kappa\left(\tilde\omega,\vec r_1(t,\vec
y),t\right)\left(\left\{d_{\vec y}\vec r_1\left(t,\vec
y\right)\right\}^T\right)^{-1}\cdot{\vec {\hat
D^*}}_1(\tilde\omega,\vec y)e^{i\tilde \omega
t}d\tilde\omega\right\},
\end{multline}
and
\begin{multline}\label{EBDHTrans444knbuihuigGGjj11jgggggjj11hhhjjj11jhjgj11jhgjjggh11gjgj11}
\vec M\left(\vec r_1(t,\vec y),t\right)+\frac{1}{c}\vec u\left(\vec
r_1(t,\vec y),t\right)\times \vec P\left(\vec r_1(t,\vec
y),t\right)=\\
\vec M\left(\vec r\left(t,\vec h(\vec
y)\right),t\right)+\frac{1}{c}\vec u\left(\vec r\left(t,\vec h(\vec
y)\right),t\right)\times \vec P\left(\vec r\left(t,\vec h(\vec
y)\right),t\right)=\\ Re\left\{2\int_{0}^{+\infty}
\kappa\left(\tilde\omega,\vec r\left(t,\vec h(\vec
y)\right),t\right)\left(\left\{d_{\vec y}\vec r\left(t,\vec h(\vec
y)\right)\cdot d_{\vec y}\vec h(\vec
y)\right\}^T\right)^{-1}\cdot\left\{d_{\vec y}\vec h\left(\vec
y\right)\right\}^T\cdot\vec {\hat H^*}\left(\tilde\omega,\vec
h(y)\right)e^{i\tilde \omega t}d\tilde\omega\right\}
\\= Re\left\{2\int_{0}^{+\infty}
\kappa\left(\tilde\omega,\vec r_1(t,\vec
y),t\right)\left(\left\{d_{\vec y}\vec r_1\left(t,\vec
y\right)\right\}^T\right)^{-1}\cdot{\vec {\hat
H^*}}_1(\tilde\omega,\vec y)e^{i\tilde \omega
t}d\tilde\omega\right\}.
\end{multline}
So we proved that the right hand side of
\er{EBDHTrans444knbuihuigGGjj11jgggggjj11hhhjjj11} or
\er{EBDHTrans444knbuihuigGGjj11jgggggjj11hhhjjj11gvghhghgjj11} is
independent of the initial instant of time $t_0$.

Finally, we can write the following possible alternative to the
dispersion law in the form of
\er{EBDHTrans444knbuihuigGGjj11ghthgtjkjj11} and
\er{EBDHTrans444knbuihuigGGjj11jgggggjj11hhhjjj11} or
\er{EBDHTrans444knbuihuigGGjj11jgggggjj11hhhjjj11gvghhghgjj11}. We
could consider,
\begin{equation}\label{EBDHTrans444knbuihuigGGjj11ghthgtjkjj11jj11}
\begin{cases}
\vec E^*(t,\vec y):=\left\{d_{\vec y}\vec r\left(t,\vec
y\right)\right\}^{-1}\cdot\left(\vec E\left(\vec r(t,\vec
y),t\right)+\frac{1}{c}\,\vec u\left(\vec r(t,\vec y),t\right)\times
\vec B\left(\vec r(t,\vec y),t\right)\right)\quad\quad\forall
t\in\mathbb{R},\;\;\forall\vec
y\in\Omega,\\
\vec B^*(t,\vec y):=\left\{d_{\vec y}\vec r\left(t,\vec
y\right)\right\}^{-1}\cdot\vec B\left(\vec r(t,\vec
y),t\right)\quad\quad\forall t\in\mathbb{R},\;\;\forall\vec
y\in\Omega,\\
\vec D^*(t,\vec y):=\left\{d_{\vec y}\vec r\left(t,\vec
y\right)\right\}^{-1}\cdot\vec D\left(\vec r(t,\vec
y),t\right)\quad\quad\forall t\in\mathbb{R},\;\;\forall\vec
y\in\Omega,\\
\vec H^*(t,\vec y):=\left\{d_{\vec y}\vec r\left(t,\vec
y\right)\right\}^{-1}\cdot\left(\vec H\left(\vec r(t,\vec
y),t\right)-\frac{1}{c}\,\vec u\left(\vec r(t,\vec y),t\right)\times
\vec D\left(\vec r(t,\vec y),t\right)\right)\quad\quad\forall
t\in\mathbb{R},\;\;\forall\vec y\in\Omega,
\end{cases}
\end{equation}
where $\vec r(t,\vec y)$ is given by \er{hjgghhghhffhkkjj11}.
Then, since $\vec E^*$, $\vec D^*$, $\vec B^*$ and $\vec H^*$ are
real vectors, we could write them as a Fourier's Transform on the
time variable:
\begin{align}\label{MaxVacFullPPNmmmffffffhhtygghGGGGyuhggghghghffggggjj11jhhhjj11jj11}
\vec E^*(t,\vec y)=Re\left\{2\int_{0}^{+\infty}  {\vec {\hat
E^*}(\tilde\omega,\vec y)}e^{i\tilde \omega
t}d\tilde\omega\right\}\quad\text{where}\quad {\vec {\hat
E^*}(\tilde\omega,\vec y)}:=\frac{1}{2\pi}\int_{-\infty}^{+\infty}
\vec E^*(t,\vec y)e^{-i\tilde \omega t}dt\,,\\
\label{MaxVacFullPPNmmmffffffhhtygghGGGGyuhggghghghffggggjj11jhhhjj11ihhgjjj11bhghhhg11}
\vec B^*(t,\vec y)=Re\left\{2\int_{0}^{+\infty}  {\vec {\hat
B^*}(\tilde\omega,\vec y)}e^{i\tilde \omega
t}d\tilde\omega\right\}\quad\text{where}\quad {\vec {\hat
B^*}(\tilde\omega,\vec y)}:=\frac{1}{2\pi}\int_{-\infty}^{+\infty}
\vec B^*(t,\vec y)e^{-i\tilde \omega t}dt \,,\\
\label{MaxVacFullPPNmmmffffffhhtygghGGGGyuhggghghghffggggjj11jhhhjj11jj11D}
\vec D^*(t,\vec y)=Re\left\{2\int_{0}^{+\infty}  {\vec {\hat
D^*}(\tilde\omega,\vec y)}e^{i\tilde \omega
t}d\tilde\omega\right\}\quad\text{where}\quad {\vec {\hat
D^*}(\tilde\omega,\vec y)}:=\frac{1}{2\pi}\int_{-\infty}^{+\infty}
\vec D^*(t,\vec y)e^{-i\tilde \omega t}dt\,,\\
\label{MaxVacFullPPNmmmffffffhhtygghGGGGyuhggghghghffggggjj11jhhhjj11ihhgjjj11bhghhhg11H}
\vec H^*(t,\vec y)=Re\left\{2\int_{0}^{+\infty}  {\vec {\hat
H^*}(\tilde\omega,\vec y)}e^{i\tilde \omega
t}d\tilde\omega\right\}\quad\text{where}\quad {\vec {\hat
H^*}(\tilde\omega,\vec y)}:=\frac{1}{2\pi}\int_{-\infty}^{+\infty}
\vec H^*(t,\vec y)e^{-i\tilde \omega t}dt\,.
\end{align}
Then, alternatively to \er{EBDHTrans444knbuihuigGGjj11ghthgtjkjj11}
and \er{EBDHTrans444knbuihuigGGjj11jgggggjj11hhhjjj11} we could
consider
\begin{multline}\label{EBDHTrans444knbuihuigGGjj11jgggggjj11hhhjjj11jj11}
\vec P\left(\vec r(t,\vec y),t\right)= Re\left\{2\int_{0}^{+\infty}
{\frac{n^2\left(\tilde\omega,\vec r(t,\vec
y),t\right)-1}{4\pi}\,\left(d_{\vec y}\vec r\left(t,\vec
y\right)\cdot\vec {\hat E^*}\left(\omega,\vec
 x\right)\right)}e^{i\omega t}d\omega\right\}\\-Re\left\{2\int_{0}^{+\infty}
{\kappa\left(\tilde\omega,\vec r(t,\vec y),t\right)\left(d_{\vec
y}\vec r\left(t,\vec y\right)\cdot\vec {\hat D^*}\left(\omega,\vec
x\right)\right)}e^{i\omega t}d\omega\right\}\\
\quad\text{and}\quad\quad\quad \vec M\left(\vec r(t,\vec
y),t\right)+\frac{1}{c}\vec u\left(\vec
r(t,\vec y),t\right)\times \vec P\left(\vec r(t,\vec y),t\right)=\\
Re\left\{2\int_{0}^{+\infty} {\kappa\left(\tilde\omega,\vec r(t,\vec
y),t\right)\left(d_{\vec y}\vec r\left(t,\vec y\right)\cdot\vec
{\hat H^*}(\tilde\omega,\vec y)\right)}e^{i\tilde \omega
t}d\tilde\omega\right\},
\end{multline}
i.e.
\begin{multline}\label{EBDHTrans444knbuihuigGGjj11jgggggjj11hhhjjj11gvghhghgjj11jj11}
\vec P\left(\vec x,t\right)=Re\left\{2\int_{0}^{+\infty}
{\frac{n^2\left(\tilde\omega,\vec
x,t\right)-1}{4\pi}\,\left(\left\{d_{\vec x}\vec f\left(\vec
x,t\right)\right\}^{-1}\cdot\vec {\hat E^*}\left(\tilde\omega,\vec
f\left(\vec x,t\right)\right)\right)}e^{i\tilde \omega
t}d\tilde\omega\right\}\\-Re\left\{2\int_{0}^{+\infty}
{\kappa\left(\tilde\omega,\vec x,t\right)\left(\left\{d_{\vec x}\vec
f\left(\vec x,t\right)\right\}^{-1}\cdot\vec {\hat
D^*}\left(\tilde\omega,\vec f\left(\vec
x,t\right)\right)\right)}e^{i\tilde \omega t}d\tilde\omega\right\}\quad\quad\text{and}\\
\vec M\left(\vec x,t\right)+\frac{1}{c}\vec u(\vec x,t)\times \vec
P(\vec x,t)=Re\left\{2\int_{0}^{+\infty}
{\kappa\left(\tilde\omega,\vec x,t\right)\left(\left\{d_{\vec x}\vec
f\left(\vec x,t\right)\right\}^{-1}\cdot\vec {\hat
H^*}\left(\tilde\omega,\vec f\left(\vec
x,t\right)\right)\right)}e^{i\tilde \omega t}d\tilde\omega\right\}.
\end{multline}
Then, exactly in the same way we already proved for
\er{EBDHTrans444knbuihuigGGjj11ghthgtjkjj11} and
\er{EBDHTrans444knbuihuigGGjj11jgggggjj11hhhjjj11} or
\er{EBDHTrans444knbuihuigGGjj11jgggggjj11hhhjjj11gvghhghgjj11} we
can also prove that the above alternative
\er{EBDHTrans444knbuihuigGGjj11ghthgtjkjj11jj11} and
\er{EBDHTrans444knbuihuigGGjj11jgggggjj11hhhjjj11jj11} or
\er{EBDHTrans444knbuihuigGGjj11jgggggjj11hhhjjj11gvghhghgjj11jj11}
is invariant under the change of inertial or non-inertial cartesian
coordinate system. Furthermore, as before, the above alternative is
equivalent to
\er{EBDHTrans444knbuihuigGGjj11jgggggjj11hhhjjj11gvghhghgjj11jgjggj11hghffh11}
in the particular case $\vec u\equiv 0$. Moreover, in the case of an
arbitrary moving transparent medium without dispersion the
alternative law is equivalent to \er{EBDHTrans444knbuihuigGG}.
Finally, as before, \er{EBDHTrans444knbuihuigGGjj11ghthgtjkjj11jj11}
and \er{EBDHTrans444knbuihuigGGjj11jgggggjj11hhhjjj11jj11} or
\er{EBDHTrans444knbuihuigGGjj11jgggggjj11hhhjjj11gvghhghgjj11jj11}
is also independent of the initial instant of time $t_0$.

We are unable to see any clear advantage of the dispersion law in
the form of \er{EBDHTrans444knbuihuigGGjj11ghthgtjkjj11} and
\er{EBDHTrans444knbuihuigGGjj11jgggggjj11hhhjjj11} or
\er{EBDHTrans444knbuihuigGGjj11jgggggjj11hhhjjj11gvghhghgjj11} with
respect to the law \er{EBDHTrans444knbuihuigGGjj11ghthgtjkjj11jj11}
and \er{EBDHTrans444knbuihuigGGjj11jgggggjj11hhhjjj11jj11} or
\er{EBDHTrans444knbuihuigGGjj11jgggggjj11hhhjjj11gvghhghgjj11jj11}.
We just note that the above two forms of the dispersion law coincide
in the case where our medium moves as a \underline{rigid} body.

\subsubsection{Optical dispersion in anisotropic moving
mediums}\label{hugygyggygygyfgyhkjhjhgj} The generalization of
\er{EBDHTrans444knbuihuigGGjj11jgggggjj11hhhjjj11gvghhghgjj11} to
anisotropic mediums is the following
\begin{multline}\label{EBDHTrans444knbuihuigGGjj11jgggggjj11hhhjjj11gvghhghgjj11hjhjhg11}
\vec P\left(\vec x,t\right)=Re\left\{2\int_{0}^{+\infty}
{\Gamma\left(\tilde\omega,\vec x,t\right)\cdot\left(\left\{d_{\vec
x}\vec f\left(\vec x,t\right)\right\}^T\cdot\vec {\hat
E^*}\left(\tilde\omega,\vec f\left(\vec
x,t\right)\right)\right)}e^{i\tilde \omega
t}d\tilde\omega\right\}\\-Re\left\{2\int_{0}^{+\infty}
{\Upsilon\left(\tilde\omega,\vec x,t\right)\cdot\left(\left\{d_{\vec
x}\vec f\left(\vec x,t\right)\right\}^T\cdot\vec {\hat
D^*}\left(\tilde\omega,\vec f\left(\vec
x,t\right)\right)\right)}e^{i\tilde \omega t}d\tilde\omega\right\}\quad\quad\text{and}\\
\vec M\left(\vec x,t\right)+\frac{1}{c}\vec u\left(\vec
x,t\right)\times\vec P\left(\vec
x,t\right)=Re\left\{2\int_{0}^{+\infty}
{\Upsilon\left(\tilde\omega,\vec x,t\right)\cdot\left(\left\{d_{\vec
x}\vec f\left(\vec x,t\right)\right\}^T\cdot\vec {\hat
H^*}\left(\tilde\omega,\vec f\left(\vec
x,t\right)\right)\right)}e^{i\tilde \omega t}d\tilde\omega\right\}.
\end{multline}
Here  $\vec E^*$, $\vec D^*$, $\vec B^*$ and $\vec H^*$ are given by
\er{EBDHTrans444knbuihuigGGjj11ghthgtjkjj11} with $\vec {\hat E^*}$,
$\vec {\hat D^*}$, $\vec {\hat B^*}$ and $\vec {\hat H^*}$ being
their Fourier's Transforms on the time variable and
$\Gamma:=\Gamma\left(\tilde\omega,\vec
x,t\right)\in\mathbb{C}^{3\times 3}$ and
$\Upsilon:=\Upsilon\left(\tilde\omega,\vec
x,t\right)\in\mathbb{C}^{3\times 3}$ are complex matrices describing
the dispersion. Then,
\er{EBDHTrans444knbuihuigGGjj11jgggggjj11hhhjjj11gvghhghgjj11hjhjhg11}
is invariant under the change of inertial or non-inertial coordinate
system, provided that under \er{jjjjkkkk11} we have
\begin{equation}\label{hjgjgjgkllujkjjjkhhjhljjkjhhjh11hkkhjhgj11jkjkjkgjg11hkhk4hjhjh11mnmbmbmbjh11}
\tilde\omega'=\tilde\omega,\quad \Gamma'\left(\tilde\omega',\vec
x',t'\right)=A(t)\cdot\Gamma\left(\tilde\omega,\vec x,t\right)\cdot
A^T(t)\quad\text{and}\quad \Upsilon'\left(\tilde\omega',\vec
x',t'\right)=A(t)\cdot\Upsilon\left(\tilde\omega,\vec
x,t\right)\cdot A^T(t).
\end{equation}
Next, in the case of an anisotropic transparent medium without
dispersion i.e. in the case where we assume that
$\Gamma:=\Gamma\left(\vec x,t\right)$ and
$\Upsilon:=\Upsilon\left(\vec x,t\right)$ are independent of the
argument $\tilde\omega$ and moreover we assume them to be
\underline{real} matrices,
\er{EBDHTrans444knbuihuigGGjj11jgggggjj11hhhjjj11gvghhghgjj11hjhjhg11}
coincides with \er{EBDHTrans444knbuihuigGGaa}. Finally, as before,
the right hand side of
\er{EBDHTrans444knbuihuigGGjj11jgggggjj11hhhjjj11gvghhghgjj11hjhjhg11}
is independent of the initial instant of time $t_0$ in
\er{hjgghhghhffhkkjj11}.

On the other hand, the generalization of
\er{EBDHTrans444knbuihuigGGjj11jgggggjj11hhhjjj11gvghhghgjj11jj11}
to anisotropic mediums is the following
\begin{multline}\label{EBDHTrans444knbuihuigGGjj11jgggggjj11hhhjjj11gvghhghgjj11jj11khghgjg11}
\vec P\left(\vec x,t\right)=Re\left\{2\int_{0}^{+\infty}
{\Gamma\left(\tilde\omega,\vec x,t\right)\cdot\left(\left\{d_{\vec
x}\vec f\left(\vec x,t\right)\right\}^{-1}\cdot\vec {\hat
E^*}\left(\tilde\omega,\vec f\left(\vec
x,t\right)\right)\right)}e^{i\tilde \omega t}d\tilde\omega\right\}-\\
Re\left\{2\int_{0}^{+\infty} {\Upsilon\left(\tilde\omega,\vec
x,t\right)\cdot\left(\left\{d_{\vec x}\vec f\left(\vec
x,t\right)\right\}^{-1}\cdot\vec {\hat D^*}\left(\tilde\omega,\vec
f\left(\vec
x,t\right)\right)\right)}e^{i\tilde \omega t}d\tilde\omega\right\}\quad\quad\text{and}\\
\vec M\left(\vec x,t\right)+\frac{1}{c}\vec u\left(\vec
x,t\right)\times\vec P\left(\vec
x,t\right)=Re\left\{2\int_{0}^{+\infty}
{\Upsilon\left(\tilde\omega,\vec x,t\right)\cdot\left(\left\{d_{\vec
x}\vec f\left(\vec x,t\right)\right\}^{-1}\cdot\vec {\hat
H^*}\left(\tilde\omega,\vec f\left(\vec
x,t\right)\right)\right)}e^{i\tilde \omega t}d\tilde\omega\right\}.
\end{multline}
Now  $\vec E^*$ and $\vec B^*$ are given by
\er{EBDHTrans444knbuihuigGGjj11ghthgtjkjj11jj11} with $\vec {\hat
E^*}$ and $\vec {\hat B^*}$ being their Fourier's Transforms on the
time variable and $\Gamma:=\Gamma\left(\tilde\omega,\vec
x,t\right)\in\mathbb{C}^{3\times 3}$ and
$\Upsilon:=\Upsilon\left(\tilde\omega,\vec
x,t\right)\in\mathbb{C}^{3\times 3}$ are, as before, complex
matrices describing the dispersion. Then again,
\er{EBDHTrans444knbuihuigGGjj11jgggggjj11hhhjjj11gvghhghgjj11jj11khghgjg11}
is invariant under the change of inertial or non-inertial coordinate
system, provided that under \er{jjjjkkkk11} we have
\er{hjgjgjgkllujkjjjkhhjhljjkjhhjh11hkkhjhgj11jkjkjkgjg11hkhk4hjhjh11mnmbmbmbjh11}.
Moreover, in the case of an anisotropic transparent medium without
dispersion i.e. in the case where we assume that
$\Gamma:=\Gamma\left(\vec x,t\right)$ and
$\Upsilon:=\Upsilon\left(\vec x,t\right)$ are independent of the
argument $\tilde\omega$ and moreover we assume them to be
\underline{real} matrices,
\er{EBDHTrans444knbuihuigGGjj11jgggggjj11hhhjjj11gvghhghgjj11jj11khghgjg11}
again coincides with \er{EBDHTrans444knbuihuigGGaa}. Finally, as
before, the right hand side of
\er{EBDHTrans444knbuihuigGGjj11jgggggjj11hhhjjj11gvghhghgjj11jj11khghgjg11}
is independent of the initial instant of time $t_0$ in
\er{hjgghhghhffhkkjj11}.

\section{Some further consequences of Maxwell equations}\label{CM}
\subsection{General case}\label{gcCM}
Again consider the system of Maxwell equations in the vacuum or in a
medium, without dispersion, having the form
\er{MaxMedFullGGffykkhjhhzz}:
\begin{equation}\label{MaxVacFullPPNffGG}
\begin{cases}
curl_{\vec x} \vec H=\frac{4\pi}{c}\vec j+
\frac{1}{c}\frac{\partial \vec D}{\partial t},\\
div_{\vec x} \vec D=4\pi\rho,\\
curl_{\vec x} \vec E+\frac{1}{c}\frac{\partial \vec B}{\partial t}=0,\\
div_{\vec x} \vec B=0\\
\vec E=\gamma_0\vec D-\frac{1}{c}\,\vec {\tilde u}\times \vec B\\
\vec H=\kappa_0\vec B+\frac{1}{c}\,\vec {\tilde u}\times \vec D,\\
\vec {\tilde u}=\left(\gamma_0\kappa_0\vec
v+(1-\gamma_0\kappa_0)\vec u\right),
\end{cases}
\end{equation}
%
%
%
%
%
%
where $\gamma_0\neq 0$ and $\kappa_0\neq 0$ are material
coefficients, $\vec v$ is the vectorial gravitational potential
$\vec u$ is the medium velocity and $\vec {\tilde
u}=\left(\gamma_0\kappa_0\vec v+(1-\gamma_0\kappa_0)\vec u\right)$
is the speed-like vector field. Remind that in the case of the
vacuum we have $\gamma_0=\kappa_0=1$, $\vec {\tilde u}=\vec v$ and
equations \er{MaxVacFullPPNffGG} are precise (in the frames of our
model). Otherwise, in the case $\kappa_0\gamma_0\neq 1$ equations
\er{MaxVacFullPPNffGG} are just an approximation that is good enough
for the case:
\begin{equation}\label{OprdddsimGGffyhjyhhtygrffgfzzjjj}
\frac{|\kappa_0||1-\kappa_0\gamma_0|\cdot|\vec u-\vec v|^2}{c^2}\ll
1.
\end{equation}
Next again by the third and the fourth equations in
\er{MaxVacFullPPNffGG} we can write
\begin{equation}\label{MaxVacFull1bjkgjhjhgjgjgkjfhjfdghghligioiuittrPPNgg}
\begin{cases}
\vec B\equiv curl_{\vec x} \vec A,\\
\vec E\equiv-\nabla_{\vec x}\Psi-\frac{1}{c}\frac{\partial\vec
A}{\partial t},
\end{cases}
\end{equation}
where $\Psi$ and $\vec A$ are the usual scalar and the vectorial
electromagnetic potentials. Then by
\er{MaxVacFull1bjkgjhjhgjgjgkjfhjfdghghligioiuittrPPNgg} and
\er{MaxVacFullPPNffGG} we have
\begin{equation}\label{vhfffngghPPN333yuyuGG}
\begin{cases}
\vec B= curl_{\vec x} \vec A\\
\vec E=-\nabla_{\vec x}\Psi-\frac{1}{c}\frac{\partial\vec
A}{\partial t}\\
 \vec D=-\frac{1}{\gamma_0}\nabla_{\vec
x}\Psi-\frac{1}{\gamma_0 c}\frac{\partial\vec A}{\partial t}+\frac{1}{c\gamma_0}\vec {\tilde u}\times curl_{\vec x}\vec A\\
\vec H= \kappa_0 \,curl_{\vec x} \vec A+\frac{1}{c}\,\vec {\tilde
u}\times\left(-\frac{1}{\gamma_0}\nabla_{\vec
x}\Psi-\frac{1}{\gamma_0 c}\frac{\partial\vec A}{\partial
t}+\frac{1}{\gamma_0 c}\vec {\tilde u}\times curl_{\vec x}\vec
A\right).
\end{cases}
\end{equation}
Next we remind the definition of the proper scalar electromagnetic
potential:
\begin{equation}\label{vhfffngghhjghhgPPNghghghutghffugghjhjkjjklggkkk}
\Psi_0:=\Psi-\frac{1}{c}\vec A\cdot\vec v,
\end{equation}
and remind also that $\vec A$ is a proper vector field and $\Psi_0$
is a proper scalar field. Then in the case of the medium we also
define an additional scalar electromagnetic potential:
\begin{equation}\label{vhfffngghhjghhgPPNghghghutghffugghjhjkjjklgg}
\Psi_1:=\Psi-\frac{1}{c}\vec A\cdot\vec {\tilde u}.
\end{equation}
Then, since $\vec A$ is a proper vector field, we deduce that
$\Psi_1$ is also a proper scalar field. Moreover, in the case of the
vacuum or more generally in the case where $\gamma_0\kappa_0\approx
1$ we have $\Psi_1=\Psi_0$. Thus by
\er{vhfffngghhjghhgPPNghghghutghffugghjhjkjjklgg} we rewrite
\er{vhfffngghPPN333yuyuGG} as:
\begin{equation}\label{vhfffngghPPNffGG}
\begin{cases}
\vec B= curl_{\vec x} \vec A\\
\vec E=-\nabla_{\vec x}\Psi_1-\frac{1}{c}\frac{\partial\vec
A}{\partial t}-\frac{1}{c}\nabla_{\vec x}\left(\vec A\cdot\vec
{\tilde u}\right)
\\
 \vec D=-\frac{1}{\gamma_0}\nabla_{\vec
x}\Psi_1-\frac{1}{\gamma_0 c}\left(\frac{\partial\vec A}{\partial
t}-\vec {\tilde u}\times curl_{\vec x}\vec A+\nabla_{\vec
x}\left(\vec A\cdot\vec {\tilde u}\right)\right)
\\
\vec H= \kappa_0\,curl_{\vec x} \vec A-\frac{1}{c}\,\vec {\tilde
u}\times
\left(\frac{1}{\gamma_0}\nabla_{\vec x}\Psi_1+\frac{1}{\gamma_0
c}\left(\frac{\partial\vec A}{\partial t}-\vec {\tilde u}\times
curl_{\vec x}\vec A+\nabla_{\vec x}\left(\vec A\cdot\vec {\tilde
u}\right)\right)\right).
\end{cases}
\end{equation}
Using Proposition \ref{yghgjtgyrtrt} we rewrite the third equation
in \er{vhfffngghPPNffGG} as
\begin{equation}\label{vhfffngghPPNffGG1}
\vec D=-\frac{1}{\gamma_0}\nabla_{\vec x}\Psi_1-\frac{1}{\gamma_0
c}\left(\frac{\partial\vec A}{\partial t}-curl_{\vec x}\left(\vec
{\tilde u}\times\vec A\right)+\left(div_{\vec x}\vec A\right)\vec
{\tilde u}+\left(d_{\vec x}\vec {\tilde u}+\left\{d_{\vec x}\vec
{\tilde u}\right\}^T\right)\cdot\vec A-\left(div_{\vec x}\vec
{\tilde u}\right)\vec A\right).
\end{equation}
Then by \er{vhfffngghPPNffGG1}, \er{vhfffngghPPNffGG} and
\er{MaxVacFullPPNffGG} we have
\begin{equation}\label{MaxVacFullPPNmmmffGG111}
\Div_{\vec x}\left\{\frac{1}{\gamma_0 c}\left(\frac{\partial\vec
A}{\partial t}-\vec {\tilde u}\times curl_{\vec x}\vec
A+\nabla_{\vec x}\left(\vec A\cdot\vec {\tilde
u}\right)\right)\right\}+\Div_{\vec
x}\left\{\frac{1}{\gamma_0}\,\nabla_{\vec x}\Psi_1\right\}=-4\pi\rho
\end{equation}
and
\begin{multline}\label{MaxVacFullPPNnnnffGG1111}
curl_{\vec x} \left\{\kappa_0\,curl_{\vec x} \vec
A-\frac{1}{\gamma_0 c}\,\vec {\tilde u}\times
\left(\nabla_{\vec x}\Psi_1+\frac{1}{c}\left(\frac{\partial\vec
A}{\partial t}-\vec {\tilde u}\times curl_{\vec x}\vec
A+\nabla_{\vec x}\left(\vec A\cdot\vec {\tilde
u}\right)\right)\right)\right\}=\\
\frac{4\pi}{c}\vec j+\frac{\partial}{\partial
t}\left\{\frac{1}{\gamma_0 c}\left(-\nabla_{\vec
x}\Psi_1-\frac{1}{c}\left(\frac{\partial\vec A}{\partial t}-\vec
{\tilde u}\times curl_{\vec x}\vec A+\nabla_{\vec x}\left(\vec
A\cdot\vec {\tilde u}\right)\right)\right)\right\}.
\end{multline}
In particular, by \er{MaxVacFullPPNmmmffGG111} and
\er{MaxVacFullPPNnnnffGG1111} we infer
\begin{multline}\label{MaxVacFullPPNnnnffGG111}
\frac{1}{\kappa_0}\,curl_{\vec x}
\left\{\kappa_0\,curl_{\vec x} \vec A\right\}=\\
\frac{4\pi}{\kappa_0 c}\left(\vec j-\rho\vec {\tilde u}\right)-
\frac{1}{\kappa_0}\left(\frac{\partial}{\partial
t}\left\{\frac{1}{\gamma_0 c}\nabla_{\vec
x}\Psi_1\right\}-curl_{\vec x}\left\{\vec {\tilde u}\times
\left\{\frac{1}{\gamma_0 c}\nabla_{\vec
x}\Psi_1\right\}\right\}+\left(\Div_{\vec x}\left\{\frac{1}{\gamma_0
c}\nabla_{\vec x}\Psi_1\right\}\right)\vec {\tilde u}\right)
\\-\frac{1}{\kappa_0}\frac{\partial}{\partial
t}\left\{\frac{1}{\gamma_0 c^2}\left(\frac{\partial\vec A}{\partial
t}-\vec {\tilde u}\times curl_{\vec x}\vec A+\nabla_{\vec
x}\left(\vec A\cdot\vec {\tilde u}\right)\right)\right\}\\+
\frac{1}{\kappa_0}\,curl_{\vec x} \left\{\vec {\tilde u}\times
\left(\frac{1}{\gamma_0 c^2}\left(\frac{\partial\vec A}{\partial
t}-\vec {\tilde u}\times curl_{\vec x}\vec A+\nabla_{\vec
x}\left(\vec A\cdot\vec {\tilde u}\right)\right)\right)\right\}\\
 -\frac{1}{\kappa_0}\left(\Div_{\vec
x}\left\{\frac{1}{\gamma_0 c^2}\left(\frac{\partial\vec A}{\partial
t}-\vec {\tilde u}\times curl_{\vec x}\vec A+\nabla_{\vec
x}\left(\vec A\cdot\vec {\tilde u}\right)\right)\right\}\right)\vec
{\tilde u}.
\end{multline}
Then, by \er{apfrm6}, \er{apfrm7} and \er{apfrm9} we rewrite
\er{MaxVacFullPPNmmmffGG111} and \er{MaxVacFullPPNnnnffGG111} as:
\begin{multline}\label{MaxVacFullPPNmmmffGG111jjkkhkjh}
\Div_{\vec x}\left\{\frac{1}{\gamma_0 c}\left(\frac{\partial\vec
A}{\partial t}-curl_{\vec x}\left\{\vec {\tilde u}\times \vec
A\right\}+\left(\Div_{\vec x}\vec A\right)\vec {\tilde
u}\right)\right\}+\Div_{\vec
x}\left\{\frac{1}{\gamma_0}\,\nabla_{\vec x}\Psi_1\right\}\\+
\Div_{\vec x}\left\{\frac{1}{\gamma_0 c}\left(d_{\vec x}\vec {\tilde
u}+\left\{d_{\vec x}\vec {\tilde u}\right\}^T\right)\cdot\vec
A-\frac{1}{\gamma_0 c}\left(\Div_{\vec x}\vec {\tilde u}\right)\vec
A\right\} =-4\pi\rho
\end{multline}
and
\begin{multline}\label{MaxVacFullPPNnnnffGG111ikiojioji}
\frac{1}{\kappa_0}\,curl_{\vec x} \left\{\kappa_0\,curl_{\vec x}
\vec A\right\} -\frac{4\pi}{\kappa_0 c}\left(\vec j-\rho\vec {\tilde
u}\right)\\+ \frac{1}{\kappa_0}\left(\frac{\partial}{\partial
t}\left\{\frac{1}{\gamma_0 c}\nabla_{\vec x}\Psi_1\right\}-\vec
{\tilde u}\times curl_{\vec x}\left\{\frac{1}{\gamma_0
c}\nabla_{\vec x}\Psi_1\right\}+\nabla_{\vec
x}\left(\left\{\frac{1}{\gamma_0 c}\nabla_{\vec
x}\Psi_1\right\}\cdot\vec {\tilde u}\right)\right) \\-
\frac{1}{\kappa_0}\left(\left(d_{\vec x}\vec {\tilde
u}+\left\{d_{\vec x}\vec {\tilde
u}\right\}^T\right)\cdot\left\{\frac{1}{\gamma_0 c}\nabla_{\vec
x}\Psi_1\right\}-\left(\Div_{\vec x}\vec {\tilde
u}\right)\left\{\frac{1}{\gamma_0 c}\nabla_{\vec
x}\Psi_1\right\}\right)
\\=-\frac{1}{\kappa_0}\frac{\partial}{\partial
t}\left\{\frac{1}{\gamma_0 c^2}\left(\frac{\partial\vec A}{\partial
t}-\vec {\tilde u}\times curl_{\vec x}\vec A+\nabla_{\vec
x}\left(\vec A\cdot\vec {\tilde u}\right)\right)\right\}\\-
\frac{1}{\kappa_0}\,\Div_{\vec x} \left\{ \left(\frac{1}{\gamma_0
c^2}\left(\frac{\partial\vec A}{\partial t}-\vec {\tilde u}\times
curl_{\vec x}\vec A+\nabla_{\vec
x}\left(\vec A\cdot\vec {\tilde u}\right)\right)\right)\otimes\vec {\tilde u}\right\}\\
 +\frac{1}{\kappa_0}\left(d_{\vec
x}\vec {\tilde u}\cdot\left\{\frac{1}{\gamma_0
c^2}\left(\frac{\partial\vec A}{\partial t}-\vec {\tilde u}\times
curl_{\vec x}\vec A+\nabla_{\vec x}\left(\vec A\cdot\vec {\tilde
u}\right)\right)\right\}\right)\\=-\frac{1}{\kappa_0}\frac{\partial}{\partial
t}\left\{\frac{1}{\gamma_0 c^2}\left(\frac{\partial\vec A}{\partial
t}+d_{\vec x}\vec A\cdot\vec {\tilde u}+\left\{d_{\vec x}\vec
{\tilde u}\right\}^T\cdot\vec A\right)\right\}\\-
\frac{1}{\kappa_0}\,\Div_{\vec x} \left\{ \left(\frac{1}{\gamma_0
c^2}\left(\frac{\partial\vec A}{\partial t}+d_{\vec x}\vec
A\cdot\vec {\tilde u}+\left\{d_{\vec x}\vec
{\tilde u}\right\}^T\cdot\vec A\right)\right)\otimes\vec {\tilde u}\right\}\\
 +\frac{1}{\kappa_0}\left(d_{\vec
x}\vec {\tilde u}\cdot\left\{\frac{1}{\gamma_0
c^2}\left(\frac{\partial\vec A}{\partial t}+d_{\vec x}\vec
A\cdot\vec {\tilde u}+\left\{d_{\vec x}\vec {\tilde
u}\right\}^T\cdot\vec A\right)\right\}\right).
\end{multline}

Next, up to the end of this section we study equation
\er{MaxVacFullPPNffGG} or equivalently
\er{MaxVacFullPPNmmmffGG111jjkkhkjh},
\er{MaxVacFullPPNnnnffGG111ikiojioji} in domains where we assume
that the coefficients $\gamma_0\neq 0$ and $\kappa_0\neq 0$ vary
sufficiently slow on the place and time and thus their spatial and
temporal derivatives are negligible with respect to other terms.
Then, we rewrite \er{MaxVacFullPPNmmmffGG111jjkkhkjh} and
\er{MaxVacFullPPNnnnffGG111ikiojioji} as
\begin{multline}\label{MaxVacFullPPNmmmffGG111jjkkhkjhdfjosjf}
\left(\frac{\partial}{\partial t}\left\{\frac{1}{c}\Div_{\vec x}\vec
A\right\}+\Div_{\vec x}\left\{\frac{1}{c}\left(\Div_{\vec x}\vec
A\right)\vec {\tilde u}\right\}\right)+\Delta_{\vec x}\Psi_1\\+
\Div_{\vec x}\left\{\frac{1}{c}\left(d_{\vec x}\vec {\tilde
u}+\left\{d_{\vec x}\vec {\tilde u}\right\}^T\right)\cdot\vec
A-\frac{1}{c}\left(\Div_{\vec x}\vec {\tilde u}\right)\vec A\right\}
=-4\pi\gamma_0\rho
\end{multline}
and
\begin{multline}\label{MaxVacFullPPNnnnffGG111ikiojiojiscjkhh}
-\Delta_{\vec x}\vec A =\frac{4\pi}{\kappa_0 c}\left(\vec j-\rho\vec
{\tilde u}\right)- \nabla_{\vec x}\left(\frac{1}{\kappa_0\gamma_0
c}\left(\frac{\partial\Psi_1}{\partial t}+\vec {\tilde
u}\cdot\nabla_{\vec x}\Psi_1\right)+\left(\Div_{\vec x}\vec
A\right)\right)
\\+ \left(\left(d_{\vec x}\vec {\tilde
u}+\left\{d_{\vec x}\vec {\tilde
u}\right\}^T\right)\cdot\left\{\frac{1}{\kappa_0\gamma_0
c}\nabla_{\vec x}\Psi_1\right\}-\left(\Div_{\vec x}\vec {\tilde
u}\right)\left\{\frac{1}{\kappa_0\gamma_0 c}\nabla_{\vec
x}\Psi_1\right\}\right)
\\-\frac{\partial}{\partial
t}\left\{\frac{1}{\gamma_0\kappa_0 c^2}\left(\frac{\partial\vec
A}{\partial t}+d_{\vec x}\vec A\cdot\vec {\tilde u}+\left\{d_{\vec
x}\vec {\tilde u}\right\}^T\cdot\vec A\right)\right\}\\- \Div_{\vec
x} \left\{\left(\frac{1}{\gamma_0\kappa_0
c^2}\left(\frac{\partial\vec A}{\partial t}+d_{\vec x}\vec
A\cdot\vec {\tilde u}+\left\{d_{\vec x}\vec
{\tilde u}\right\}^T\cdot\vec A\right)\right)\otimes\vec {\tilde u}\right\}\\
 +\left(d_{\vec
x}\vec {\tilde u}\cdot\left\{\frac{1}{\gamma_0\kappa_0
c^2}\left(\frac{\partial\vec A}{\partial t}+d_{\vec x}\vec
A\cdot\vec {\tilde u}+\left\{d_{\vec x}\vec {\tilde
u}\right\}^T\cdot\vec A\right)\right\}\right).
\end{multline}
Alternatively, we rewrite
\er{MaxVacFullPPNnnnffGG111ikiojiojiscjkhh} as:
\begin{multline}\label{MaxVacFullPPNnnnffGG111ikiojiojiscjkhhopop}
-\Delta_{\vec x}\vec A = \frac{4\pi}{\kappa_0 c}\left(\vec
j-\rho\vec {\tilde u}\right)-\nabla_{\vec x}\left(\Div_{\vec x}\vec
A\right)\\- \frac{1}{\gamma_0\kappa_0
c}\left(\frac{\partial}{\partial t}\left\{\nabla_{\vec
x}\Psi_1\right\}-curl_{\vec x}\left\{\vec {\tilde u}\times
\nabla_{\vec x}\Psi_1\right\}+\left(\Delta_{\vec x}\Psi_1\right)\vec
{\tilde u}\right)
\\-\frac{\partial}{\partial
t}\left\{\frac{1}{\gamma_0\kappa_0 c^2}\left(\frac{\partial\vec
A}{\partial t}+d_{\vec x}\vec A\cdot\vec {\tilde u}+\left\{d_{\vec
x}\vec {\tilde u}\right\}^T\cdot\vec A\right)\right\}\\- \Div_{\vec
x} \left\{\left(\frac{1}{\gamma_0\kappa_0
c^2}\left(\frac{\partial\vec A}{\partial t}+d_{\vec x}\vec
A\cdot\vec {\tilde u}+\left\{d_{\vec x}\vec
{\tilde u}\right\}^T\cdot\vec A\right)\right)\otimes\vec {\tilde u}\right\}\\
 +\left(d_{\vec
x}\vec {\tilde u}\cdot\left\{\frac{1}{\gamma_0\kappa_0
c^2}\left(\frac{\partial\vec A}{\partial t}+d_{\vec x}\vec
A\cdot\vec {\tilde u}+\left\{d_{\vec x}\vec {\tilde
u}\right\}^T\cdot\vec A\right)\right\}\right).
\end{multline}

Next if we assume the following calibration of the potentials:
\begin{equation}\label{MaxVacFullPPNjjjjffhhGGGGGG}
\Div_{\vec x}\vec A=0,
\end{equation}
then by \er{MaxVacFullPPNjjjjffhhGGGGGG},
\er{MaxVacFullPPNmmmffGG111jjkkhkjhdfjosjf},
\er{MaxVacFullPPNnnnffGG111ikiojiojiscjkhhopop} and \er{apfrm6} we
have
\begin{equation}\label{MaxVacFullPPNmmmffffffhhtygghGGGG}
-\Delta_{\vec x}\Psi_1=4\pi\gamma_0\rho+\Div_{\vec x}
\left\{\frac{1}{c}\left(d_{\vec x}\vec {\tilde u}+\left\{d_{\vec
x}\vec {\tilde u}\right\}^T\right)\cdot\vec
A-\frac{1}{c}\left(\Div_{\vec x}\vec {\tilde u}\right)\vec
A\right\},
\end{equation}
and
\begin{multline}\label{MaxVacFullPPNnnnffffffyuughjhjhjhhjjkjhkkjhhjhghGGGG}
-\Delta_{\vec x}\vec A = \frac{4\pi}{\kappa_0 c}\left(\vec j-
\frac{1}{4\pi\gamma_0}\left(\frac{\partial}{\partial
t}\left\{\nabla_{\vec x}\Psi_1\right\}-curl_{\vec x}\left\{\vec
{\tilde u}\times \nabla_{\vec
x}\Psi_1\right\}\right)\right)\\+\frac{1}{\gamma_0\kappa_0
c^2}\left( \Div_{\vec x} \left\{\left(d_{\vec x}\vec {\tilde
u}+\left\{d_{\vec x}\vec {\tilde u}\right\}^T\right)\cdot\vec
A-\left(\Div_{\vec x}\vec {\tilde u}\right)\vec A\right\}\right)\vec
{\tilde u}
\\-\frac{\partial}{\partial
t}\left\{\frac{1}{\gamma_0\kappa_0 c^2}\left(\frac{\partial\vec
A}{\partial t}+d_{\vec x}\vec A\cdot\vec {\tilde u}+\left\{d_{\vec
x}\vec {\tilde u}\right\}^T\cdot\vec A\right)\right\}\\- \Div_{\vec
x} \left\{\left(\frac{1}{\gamma_0\kappa_0
c^2}\left(\frac{\partial\vec A}{\partial t}+d_{\vec x}\vec
A\cdot\vec {\tilde u}+\left\{d_{\vec x}\vec
{\tilde u}\right\}^T\cdot\vec A\right)\right)\otimes\vec {\tilde u}\right\}\\
 +\left(d_{\vec
x}\vec {\tilde u}\cdot\left\{\frac{1}{\gamma_0\kappa_0
c^2}\left(\frac{\partial\vec A}{\partial t}+d_{\vec x}\vec
A\cdot\vec {\tilde u}+\left\{d_{\vec x}\vec {\tilde
u}\right\}^T\cdot\vec A\right)\right\}\right).
\end{multline}

On the other hand, if we assume the following alternative
calibration of the potentials:
\begin{equation}\label{MaxVacFullPPNjjjjffhhGG}
\frac{1}{\kappa_0\gamma_0 c}\left(\frac{\partial\Psi_1}{\partial
t}+\vec {\tilde u}\cdot\nabla_{\vec x}\Psi_1\right)+\Div_{\vec
x}\vec A=0,
\end{equation}
then by \er{MaxVacFullPPNmmmffGG111jjkkhkjhdfjosjf},
\er{MaxVacFullPPNnnnffGG111ikiojiojiscjkhh} we have
\begin{multline}\label{MaxVacFullPPNmmmffffffiuiuhjuGG}
\left(\frac{\partial}{\partial t}\left\{\frac{1}{\kappa_0\gamma_0
c^2}\left(\frac{\partial\Psi_1}{\partial t}+\vec {\tilde
u}\cdot\nabla_{\vec x}\Psi_1\right)\right\}+\Div_{\vec x}
\left\{\frac{1}{\kappa_0\gamma_0
c^2}\left(\frac{\partial\Psi_1}{\partial t}+\vec {\tilde
u}\cdot\nabla_{\vec x}\Psi_1\right)\vec {\tilde
u}\right\}\right)-\Delta_{\vec
x}\Psi_1\\=4\pi\gamma_0\rho+\Div_{\vec x}
\left\{\frac{1}{c}\left(d_{\vec x}\vec {\tilde u}+\left\{d_{\vec
x}\vec {\tilde u}\right\}^T\right)\cdot\vec
A-\frac{1}{c}\left(\Div_{\vec x}\vec {\tilde u}\right)\vec
A\right\},
\end{multline}
and
\begin{multline}\label{MaxVacFullPPNnnnffffffyuughjhjhjhhjjkjhkkjhujgGG}
-\Delta_{\vec x}\vec A= \frac{4\pi}{\kappa_0 c}\left(\vec j-\rho\vec
{\tilde u}\right)+ \left(\left(d_{\vec x}\vec {\tilde
u}+\left\{d_{\vec x}\vec {\tilde
u}\right\}^T\right)\cdot\left\{\frac{1}{\kappa_0\gamma_0
c}\nabla_{\vec x}\Psi_1\right\}-\left(\Div_{\vec x}\vec {\tilde
u}\right)\left\{\frac{1}{\kappa_0\gamma_0 c}\nabla_{\vec
x}\Psi_1\right\}\right)
\\-\frac{\partial}{\partial
t}\left\{\frac{1}{\gamma_0\kappa_0 c^2}\left(\frac{\partial\vec
A}{\partial t}+d_{\vec x}\vec A\cdot\vec {\tilde u}+\left\{d_{\vec
x}\vec {\tilde u}\right\}^T\cdot\vec A\right)\right\}\\- \Div_{\vec
x} \left\{\left(\frac{1}{\gamma_0\kappa_0
c^2}\left(\frac{\partial\vec A}{\partial t}+d_{\vec x}\vec
A\cdot\vec {\tilde u}+\left\{d_{\vec x}\vec
{\tilde u}\right\}^T\cdot\vec A\right)\right)\otimes\vec {\tilde u}\right\}\\
 +\left(d_{\vec
x}\vec {\tilde u}\cdot\left\{\frac{1}{\gamma_0\kappa_0
c^2}\left(\frac{\partial\vec A}{\partial t}+d_{\vec x}\vec
A\cdot\vec {\tilde u}+\left\{d_{\vec x}\vec {\tilde
u}\right\}^T\cdot\vec A\right)\right\}\right).
\end{multline}

Next, from now we assume that $\vec {\tilde u}$ varies sufficiently
slowly in space and time variables, so that the following
approximation is valid:
\begin{equation}\label{MaxVacFullPPNmmmffffffhhtygghGGGGyuhggghghghffgiooiiojhjjhhjuiuiugughhjggjjhkoouou}
\frac{\left| d_{\vec x}\vec {\tilde u}+\left\{d_{\vec x}\vec {\tilde
u}\right\}^T\right|}{ c}\ll \frac{|\nabla_{\vec x}\Psi_1|}{| \vec
A|+|\Psi_1|}+\frac{|\gamma_0\kappa_0||\Delta_{\vec x}\vec A|}{|
d_{\vec x}\vec A|+|\nabla_{\vec x}\Psi_1|}.
\end{equation}
In particular, if in some Cartesian coordinate system we have both
\begin{equation}\label{MaxVacFullPPNmmmffffffhhtygghGGGGyuhggghghghffgiooiiojhjjhhjuiuiugughhjggjjhkoououioio}
\frac{\left| d_{\vec x}\vec {\tilde u}+\left\{d_{\vec x}\vec {\tilde
u}\right\}^T\right|^2}{
\left|\vec {\tilde u}\right|^2}\ll \left(\frac{|\nabla_{\vec
x}\Psi_1|}{| \vec
A|+|\Psi_1|}\right)^2+\left(\frac{|\gamma_0\kappa_0||\Delta_{\vec
x}\vec A|}{| d_{\vec x}\vec A|+|\nabla_{\vec x}\Psi_1|}\right)^2,
\end{equation}
and
\begin{equation}\label{MaxVacFullPPNmmmffffffiuiuhjuughbghhiuijghghghhjhjhhghyuyhhjhjihikjjjjkjljklghhggghiioio}
\frac{\left|\vec {\tilde u}\right|^2}{
c^2}\ll 1,
\end{equation}
then
\er{MaxVacFullPPNmmmffffffhhtygghGGGGyuhggghghghffgiooiiojhjjhhjuiuiugughhjggjjhkoouou}
indeed holds! Furthermore, taking into the account
\er{MaxVacFullPPNmmmffffffhhtygghGGGGyuhggghghghffgiooiiojhjjhhjuiuiugughhjggjjhkoouou},
under the calibration \er{MaxVacFullPPNjjjjffhhGGGGGG}, we rewrite
\er{MaxVacFullPPNmmmffffffhhtygghGGGG} and
\er{MaxVacFullPPNnnnffffffyuughjhjhjhhjjkjhkkjhhjhghGGGG} as
\begin{equation}\label{MaxVacFullPPNmmmffffffhhtygghGGGGgg}
-\Delta_{\vec x}\Psi_1\approx 4\pi\gamma_0\rho,
\end{equation}
and
\begin{multline}\label{MaxVacFullPPNnnnffffffyuughjhjhjhhjjkjhkkjhhjhghGGGGgg}
-\Delta_{\vec x}\vec A\approx \frac{4\pi}{\kappa_0 c}\left(\vec j-
\frac{1}{4\pi\gamma_0}\left(\frac{\partial}{\partial
t}\left\{\nabla_{\vec x}\Psi_1\right\}-curl_{\vec x}\left\{\vec
{\tilde u}\times \nabla_{\vec x}\Psi_1\right\}\right)\right)
\\-\frac{\partial}{\partial
t}\left\{\frac{1}{\gamma_0\kappa_0 c^2}\left(\frac{\partial\vec
A}{\partial t}+d_{\vec x}\vec A\cdot\vec {\tilde u}+\left\{d_{\vec
x}\vec {\tilde u}\right\}^T\cdot\vec A\right)\right\}\\- \Div_{\vec
x} \left\{\left(\frac{1}{\gamma_0\kappa_0
c^2}\left(\frac{\partial\vec A}{\partial t}+d_{\vec x}\vec
A\cdot\vec {\tilde u}+\left\{d_{\vec x}\vec
{\tilde u}\right\}^T\cdot\vec A\right)\right)\otimes\vec {\tilde u}\right\}\\
 +\left(d_{\vec
x}\vec {\tilde u}\cdot\left\{\frac{1}{\gamma_0\kappa_0
c^2}\left(\frac{\partial\vec A}{\partial t}+d_{\vec x}\vec
A\cdot\vec {\tilde u}+\left\{d_{\vec x}\vec {\tilde
u}\right\}^T\cdot\vec A\right)\right\}\right).
\end{multline}
Note that, using Proposition \ref{yghgjtgyrtrt} we deduce that the
approximate equations \er{MaxVacFullPPNmmmffffffhhtygghGGGGgg} and
\er{MaxVacFullPPNnnnffffffyuughjhjhjhhjjkjhkkjhhjhghGGGGgg} are
still invariant under the change of inertial or non-inertial
cartesian coordinate system, provided that $\vec A$ is a proper
vector field and $\Psi_1$ is a proper scalar field.

On the other hand, taking into the account
\er{MaxVacFullPPNmmmffffffhhtygghGGGGyuhggghghghffgiooiiojhjjhhjuiuiugughhjggjjhkoouou},
under the calibration \er{MaxVacFullPPNjjjjffhhGG}, we rewrite
\er{MaxVacFullPPNmmmffffffiuiuhjuGG} and
\er{MaxVacFullPPNnnnffffffyuughjhjhjhhjjkjhkkjhujgGG} as
\begin{equation}\label{MaxVacFullPPNmmmffffffiuiuhjuGGgg}
\frac{1}{\kappa_0\gamma_0 c^2}\left(\frac{\partial}{\partial
t}\left(\frac{\partial\Psi_1}{\partial t}+\vec {\tilde
u}\cdot\nabla_{\vec x}\Psi_1\right)+div_{\vec x}
\left\{\left(\frac{\partial\Psi_1}{\partial t}+\vec {\tilde
u}\cdot\nabla_{\vec x}\Psi_1\right)\vec {\tilde
u}\right\}\right)-\Delta_{\vec x}\Psi_1\approx 4\pi\gamma_0\rho.
\end{equation}
and
\begin{multline}\label{MaxVacFullPPNnnnffffffyuughjhjhjhhjjkjhkkjhujgGGgg}
-\Delta_{\vec x}\vec A\approx \frac{4\pi}{\kappa_0 c}\left(\vec
j-\rho\vec {\tilde u}\right)-\frac{\partial}{\partial
t}\left\{\frac{1}{\gamma_0\kappa_0 c^2}\left(\frac{\partial\vec
A}{\partial t}+d_{\vec x}\vec A\cdot\vec {\tilde u}+\left\{d_{\vec
x}\vec {\tilde u}\right\}^T\cdot\vec A\right)\right\}\\- \Div_{\vec
x} \left\{\left(\frac{1}{\gamma_0\kappa_0
c^2}\left(\frac{\partial\vec A}{\partial t}+d_{\vec x}\vec
A\cdot\vec {\tilde u}+\left\{d_{\vec x}\vec
{\tilde u}\right\}^T\cdot\vec A\right)\right)\otimes\vec {\tilde u}\right\}\\
 +\left(d_{\vec
x}\vec {\tilde u}\cdot\left\{\frac{1}{\gamma_0\kappa_0
c^2}\left(\frac{\partial\vec A}{\partial t}+d_{\vec x}\vec
A\cdot\vec {\tilde u}+\left\{d_{\vec x}\vec {\tilde
u}\right\}^T\cdot\vec A\right)\right\}\right).
\end{multline}
Again note that, using Proposition \ref{yghgjtgyrtrt} we deduce that
the approximate equations \er{MaxVacFullPPNmmmffffffiuiuhjuGGgg} and
\er{MaxVacFullPPNnnnffffffyuughjhjhjhhjjkjhkkjhujgGGgg} are still
invariant under the change of inertial or non-inertial cartesian
coordinate system, provided that $\vec A$ is a proper vector field
and $\Psi_1$ is a proper scalar field.

In particular, assume that in some Cartesian coordinate system $(*)$
we have the following stronger than
\er{MaxVacFullPPNmmmffffffhhtygghGGGGyuhggghghghffgiooiiojhjjhhjuiuiugughhjggjjhkoouou}
approximation:
\begin{equation}\label{MaxVacFullPPNmmmffffffhhtygghGGGGyuhggghgh}
\frac{\left| d_{\vec x}\vec {\tilde u}\right|}{
c}\ll \frac{|\gamma_0\kappa_0|| \Delta_{\vec x}\vec A|
+\frac{1}{c}\left|d_{\vec x}\left\{\frac{\partial \vec A}{\partial
t}\right\}\right|+\frac{1}{c^2}\left|\frac{\partial^2 \vec
A}{\partial t^2}\right|+|\gamma_0\kappa_0|| \nabla^2_{\vec
x}\Psi_1|+\frac{1}{c}\left|\frac{\partial}{\partial
t}\left(\nabla_{\vec
x}\Psi_1\right)\right|+\frac{1}{c^2}\left|\frac{\partial^2
\Psi_1}{\partial t^2}\right|}{| d_{\vec x}\vec
A|+\frac{1}{c}\left|\frac{\partial \vec A}{\partial
t}\right|+|\nabla_{\vec x}\Psi_1|+\frac{1}{c}\left|\frac{\partial
\Psi_1}{\partial t}\right|},
\end{equation}
i.e. the field $\vec {\tilde u}$ changes in space much slower than
$(\Psi_1,\vec A)$. Estimation
\er{MaxVacFullPPNmmmffffffhhtygghGGGGyuhggghgh} holds especially
good for the electromagnetic waves of high frequency for example for
the visible light. However,
\er{MaxVacFullPPNmmmffffffhhtygghGGGGyuhggghgh} is still well for
almost every electromagnetic field we meet in the common life,
except probably the magnetic field of the Earth. Then, by
\er{MaxVacFullPPNmmmffffffiuiuhjuGGgg},
\er{MaxVacFullPPNnnnffffffyuughjhjhjhhjjkjhkkjhujgGGgg} and
\er{MaxVacFullPPNmmmffffffhhtygghGGGGyuhggghgh} we can write the
further approximating equations:
\begin{equation}\label{MaxVacFullPPNmmmffffffiuiuhjuGGggFG}
\left(\frac{\partial}{\partial
t}\left\{\frac{1}{c^2_0}\left(\frac{\partial\Psi_1}{\partial t}+\vec
{\tilde u}\cdot\nabla_{\vec x}\Psi_1\right)\right\}+\Div_{\vec x}
\left\{\frac{1}{c^2_0}\left(\frac{\partial\Psi_1}{\partial t}+\vec
{\tilde u}\cdot\nabla_{\vec x}\Psi_1\right)\vec {\tilde
u}\right\}\right)-\Delta_{\vec x}\Psi_1 \approx 4\pi\gamma_0\rho,
\end{equation}
and
\begin{equation}\label{MaxVacFullPPNnnnffffffyuughjhjhjhhjjkjhkkjhujgGGggFG}
\left(\frac{\partial}{\partial
t}\left\{\frac{1}{c^2_0}\left(\frac{\partial\vec A}{\partial
t}+d_{\vec x}\vec A\cdot\vec {\tilde u}\right)\right\}+\Div_{\vec x}
\left\{\frac{1}{c^2_0}\left(\frac{\partial\vec A}{\partial
t}+d_{\vec x}\vec A\cdot\vec {\tilde u}\right)\otimes\vec {\tilde
u}\right\}\right)-\Delta_{\vec x}\vec A\approx \frac{4\pi}{\kappa_0
c}\left(\vec j-\rho\vec {\tilde u}\right),
\end{equation}
where the scalar quantity $c_0$, defined by
\begin{equation}\label{gughhghfbvnbv}
c_0=c\sqrt{\kappa_0\gamma_0},
\end{equation}
is called speed of light in the medium. Note that, although the
approximate equations \er{MaxVacFullPPNmmmffffffiuiuhjuGGggFG} and
\er{MaxVacFullPPNnnnffffffyuughjhjhjhhjjkjhkkjhujgGGggFG} are
invariant under the Galilean Transformation, they are not invariant
under the more general change of non-inertial cartesian coordinate
system. However, \er{MaxVacFullPPNmmmffffffiuiuhjuGGggFG} and
\er{MaxVacFullPPNnnnffffffyuughjhjhjhhjjkjhkkjhujgGGggFG} are more
convenient than \er{MaxVacFullPPNmmmffffffiuiuhjuGGgg} and
\er{MaxVacFullPPNnnnffffffyuughjhjhjhhjjkjhkkjhujgGGgg}, since the
scalar potential $\Psi_1$ and every of the three scalar components
of the vector potential $\vec A$ in
\er{MaxVacFullPPNmmmffffffiuiuhjuGGggFG} and
\er{MaxVacFullPPNnnnffffffyuughjhjhjhhjjkjhkkjhujgGGggFG} satisfies
four decoupled equations of the same type, that differ only by the
right parts. On the other hand, if we consider some three proper
vector fields $\vec e_1:=\vec e_1(\vec x,t)$, $\vec e_2:=\vec
e_2(\vec x,t)$, and $\vec e_3:=\vec e_3(\vec x,t)$, which are
mutually orthogonal to each other and satisfy the following
approximation:
\begin{equation}\label{MaxVacFullPPNmmmffffffhhtygghGGGGyuhggghghhjhjh}
\frac{|d_{\vec x}\vec {e}_1|+c_0|\partial_t\vec e_1|}{|\vec
{e}_1|}+\frac{|d_{\vec x}\vec {e}_2|+c_0|\partial_t\vec e_2|}{|\vec
{e}_2|}+\frac{|d_{\vec x}\vec {e}_3|+c_0|\partial_t\vec e_3|}{|\vec
{e}_3|}\ll\frac{|\nabla_{\vec
x}\Psi_1|+c_0|\partial_t\Psi_1|+|d_{\vec x}\vec
A|+c_0|\partial_t\vec A|}{|\Psi_1|+|\vec A|}.
\end{equation}
in the given before coordinate system $(*)$, i.e. the field $\vec
{e}_k$ vary in space and time much weaker than $(\Psi_1,\vec A)$,
then we may write the alternative to
\er{MaxVacFullPPNnnnffffffyuughjhjhjhhjjkjhkkjhujgGGggFG} and
\er{MaxVacFullPPNmmmffffffiuiuhjuGGggFG} approximate equations in
the form of four decoupled scalar invariant wave equations of the
same type:
\begin{multline}\label{MaxVacFullPPNmmmffffffiuiuhjuGGggFGjjkkjj}
\left(\frac{\partial}{\partial
t}\left\{\frac{1}{c^2_0}\left(\frac{\partial}{\partial t}\left(\vec
e_k\cdot\vec A\right)+\vec {\tilde u}\cdot\nabla_{\vec x}\left(\vec
e_k\cdot\vec A\right)\right)\right\}+\Div_{\vec x}
\left\{\frac{1}{c^2_0}\left(\frac{\partial}{\partial t}\left(\vec
e_k\cdot\vec A\right)+\vec {\tilde u}\cdot\nabla_{\vec x}\left(\vec
e_k\cdot\vec A\right)\right)\vec {\tilde
u}\right\}\right)\\-\Delta_{\vec x}\left(\vec e_k\cdot\vec A\right)
\,\approx\, \frac{4\pi}{\kappa_0 c}\left(\vec j-\rho\vec {\tilde
u}\right)\cdot\vec e_k\quad\quad\forall k=1,2,3,
\end{multline}
and
\begin{equation}\label{MaxVacFullPPNmmmffffffiuiuhjuGGggFGkjkljl}
\left(\frac{\partial}{\partial
t}\left\{\frac{1}{c^2_0}\left(\frac{\partial\Psi_1}{\partial t}+\vec
{\tilde u}\cdot\nabla_{\vec x}\Psi_1\right)\right\}+\Div_{\vec x}
\left\{\frac{1}{c^2_0}\left(\frac{\partial\Psi_1}{\partial t}+\vec
{\tilde u}\cdot\nabla_{\vec x}\Psi_1\right)\vec {\tilde
u}\right\}\right)-\Delta_{\vec x}\Psi_1 \approx 4\pi\gamma_0\rho.
\end{equation}
Then, clearly, the new alternative approximate equations
\er{MaxVacFullPPNmmmffffffiuiuhjuGGggFGjjkkjj},
\er{MaxVacFullPPNmmmffffffiuiuhjuGGggFGkjkljl} are indeed invariant
under the more general change of non-inertial cartesian coordinate
system. So we can use approximate equations
\er{MaxVacFullPPNmmmffffffiuiuhjuGGggFGjjkkjj} and
\er{MaxVacFullPPNmmmffffffiuiuhjuGGggFGkjkljl} in the arbitrary
Cartesian coordinate system $(*)$ even if
\er{MaxVacFullPPNmmmffffffhhtygghGGGGyuhggghgh} and
\er{MaxVacFullPPNmmmffffffhhtygghGGGGyuhggghghhjhjh} are not
satisfied in the system $(*)$, provided that
\er{MaxVacFullPPNmmmffffffhhtygghGGGGyuhggghgh} and
\er{MaxVacFullPPNmmmffffffhhtygghGGGGyuhggghghhjhjh} are  satisfied
in another Cartesian system $(**)$.

In the absence of charges and currents (for example for
electromagnetic waves) equations
\er{MaxVacFullPPNmmmffffffiuiuhjuGGggFG} and
\er{MaxVacFullPPNnnnffffffyuughjhjhjhhjjkjhkkjhujgGGggFG} become:
\begin{equation}\label{MaxVacFullPPNmmmffffffiuiuhjuGGggFGel}
\left(\frac{\partial}{\partial
t}\left\{\frac{1}{c^2_0}\left(\frac{\partial\Psi_1}{\partial t}+\vec
{\tilde u}\cdot\nabla_{\vec x}\Psi_1\right)\right\}+\Div_{\vec x}
\left\{\frac{1}{c^2_0}\left(\frac{\partial\Psi_1}{\partial t}+\vec
{\tilde u}\cdot\nabla_{\vec x}\Psi_1\right)\vec {\tilde
u}\right\}\right)-\Delta_{\vec x}\Psi_1=0,
\end{equation}
and
\begin{equation}\label{MaxVacFullPPNnnnffffffyuughjhjhjhhjjkjhkkjhujgGGggFGel}
\left(\left\{\frac{1}{c^2_0}\frac{\partial}{\partial
t}\left(\frac{\partial\vec A}{\partial t}+d_{\vec x}\vec A\cdot\vec
{\tilde u}\right)\right\}+\Div_{\vec x}
\left\{\frac{1}{c^2_0}\left(\frac{\partial\vec A}{\partial
t}+d_{\vec x}\vec A\cdot\vec {\tilde u}\right)\otimes\vec {\tilde
u}\right\}\right)-\Delta_{\vec x}\vec A=0,
\end{equation}
and equations \er{MaxVacFullPPNmmmffffffiuiuhjuGGggFGjjkkjj},
\er{MaxVacFullPPNmmmffffffiuiuhjuGGggFGkjkljl} become:
\begin{multline}\label{MaxVacFullPPNmmmffffffiuiuhjuGGggFGjjkkjjkojjk}
\left(\frac{\partial}{\partial
t}\left\{\frac{1}{c^2_0}\left(\frac{\partial}{\partial t}\left(\vec
e_k\cdot\vec A\right)+\vec {\tilde u}\cdot\nabla_{\vec x}\left(\vec
e_k\cdot\vec A\right)\right)\right\}+\Div_{\vec x}
\left\{\frac{1}{c^2_0}\left(\frac{\partial}{\partial t}\left(\vec
e_k\cdot\vec A\right)+\vec {\tilde u}\cdot\nabla_{\vec x}\left(\vec
e_k\cdot\vec A\right)\right)\vec {\tilde
u}\right\}\right)\\-\Delta_{\vec x}\left(\vec e_k\cdot\vec A\right)
\,\approx\, 0\quad\quad\forall k=1,2,3,
\end{multline}
and
\begin{equation}\label{MaxVacFullPPNmmmffffffiuiuhjuGGggFGkjkljlh}
\left(\frac{\partial}{\partial
t}\left\{\frac{1}{c^2_0}\left(\frac{\partial\Psi_1}{\partial t}+\vec
{\tilde u}\cdot\nabla_{\vec x}\Psi_1\right)\right\}+\Div_{\vec x}
\left\{\frac{1}{c^2_0}\left(\frac{\partial\Psi_1}{\partial t}+\vec
{\tilde u}\cdot\nabla_{\vec x}\Psi_1\right)\vec {\tilde
u}\right\}\right)-\Delta_{\vec x}\Psi_1 \approx 0.
\end{equation}
Therefore, by \er{vhfffngghPPNffGG}, differentiating
\er{MaxVacFullPPNmmmffffffiuiuhjuGGggFGel} and
\er{MaxVacFullPPNnnnffffffyuughjhjhjhhjjkjhkkjhujgGGggFGel} or
\er{MaxVacFullPPNmmmffffffiuiuhjuGGggFGjjkkjjkojjk} and
\er{MaxVacFullPPNmmmffffffiuiuhjuGGggFGkjkljlh} and further usage of
\er{MaxVacFullPPNmmmffffffhhtygghGGGGyuhggghgh} and
\er{MaxVacFullPPNmmmffffffhhtygghGGGGyuhggghghhjhjh} gives that if
the scalar field $U:=U(\vec x,t)$ is one of any three scalar
components of every of the fields $\vec E$, $\vec B$, $\vec D$ or
$\vec H$, then $U$ satisfies the following approximate scalar
equation of the wave type:
\begin{equation}\label{MaxVacFullPPNmmmffffffiuiuhjuGGggFGelGHGHGHGG}
\left(\frac{\partial}{\partial
t}\left\{\frac{1}{c^2_0}\left(\frac{\partial U}{\partial t}+\vec
{\tilde u}\cdot\nabla_{\vec x}U\right)\right\}+\Div_{\vec x}
\left\{\frac{1}{c^2_0}\left(\frac{\partial U}{\partial t}+\vec
{\tilde u}\cdot\nabla_{\vec x} U\right)\vec {\tilde
u}\right\}\right)-\Delta_{\vec x}U \approx 0,
\end{equation}
where,
\begin{equation}\label{uyuyuyy}
\vec {\tilde u}=\left(\gamma_0\kappa_0\vec
v+\left(1-\gamma_0\kappa_0\right)\vec u\right)\,.
\end{equation}
Note that the wave equation
\er{MaxVacFullPPNmmmffffffiuiuhjuGGggFGelGHGHGHGG} is of the same
type as
\er{MaxVacFull1ninshtrgravortghhghgjkgghklhjgkghghjjkjhjkkggjkhjkhjjhhfhjhkjhjjhkhbbgjlmkmmhzzzyyykkknnnNWBWHWPPNjjlllkkkkkkljkkjjkjkhjjhhgghgggghjjhgjggh},
\er{MaxVacFull1ninshtrgravortghhghgjkgghklhjgkghghjjkjhjkkggjkhjkhjjhhfhjhkjhjjhkhbbgjlmkmmhzzzyyykkknnnNWBWHWPPNjjlllkkkkkkljkkjjkjkhjjhhgghgggghjjhgjgghjkjkjk},
\er{MaxVacFull1ninshtrgravortghhghgjkgghklhjgkghghjjkjhjkkggjkhjkhjjhhfhjhkjkhbbgjhzzzyyykkknnnNWBWHWPPNjjjljkjkkhhhhhhhhhhhhfnfffhuyuyuyutjhjhjhggjjfgfhfhjhhgjgjhkjhjhkhkhjj}
or \er{MaxVacFullPPNmmmffffffiuiuhjuGGggFGjjkkjjffjj}, with the only
difference that the field $\vec {\tilde u}$ appear in
\er{MaxVacFullPPNmmmffffffiuiuhjuGGggFGelGHGHGHGG} instead of the
field $\vec u_0$ in
\er{MaxVacFull1ninshtrgravortghhghgjkgghklhjgkghghjjkjhjkkggjkhjkhjjhhfhjhkjhjjhkhbbgjlmkmmhzzzyyykkknnnNWBWHWPPNjjlllkkkkkkljkkjjkjkhjjhhgghgggghjjhgjggh},
\er{MaxVacFull1ninshtrgravortghhghgjkgghklhjgkghghjjkjhjkkggjkhjkhjjhhfhjhkjhjjhkhbbgjlmkmmhzzzyyykkknnnNWBWHWPPNjjlllkkkkkkljkkjjkjkhjjhhgghgggghjjhgjgghjkjkjk}.
\er{MaxVacFull1ninshtrgravortghhghgjkgghklhjgkghghjjkjhjkkggjkhjkhjjhhfhjhkjkhbbgjhzzzyyykkknnnNWBWHWPPNjjjljkjkkhhhhhhhhhhhhfnfffhuyuyuyutjhjhjhggjjfgfhfhjhhgjgjhkjhjhkhkhjj}
or \er{MaxVacFullPPNmmmffffffiuiuhjuGGggFGjjkkjjffjj}.

\subsection{The case of quasistationary electromagnetic fields inside a slowly moving medium in a weak gravitational
field}\label{qfCM} Assume that in the given inertial or non-inertial
cartesian coordinate system $(*)$ the field $\vec {\tilde u}$ is
weak, meaning that at any instant on every point:
\begin{equation}\label{slowaetherGGaa}
\frac{1}{\kappa_0\gamma_0}\frac{|\vec {\tilde u}|^2}{c^2}\,\ll\, 1.
\end{equation}
Here $\vec {\tilde u}=\left(\gamma_0\kappa_0\vec
v+(1-\gamma_0\kappa_0)\vec u\right)$ is the speed-like vector field,
where $\vec v$ is a vectorial gravitational potential in the system
$(*)$ and $\vec u$ is the medium velocity. Furthermore, consider
quasistationary electromagnetic fields. This means the following:
assume that the changes in time of the physical characteristics of
the electromagnetic fields become essential after certain interval
of time $T_e$ and the changes in space of the physical
characteristics of the fields become essential in the spatial
landscape $L_e$. Then we assume that
\begin{equation}\label{slochangGGaa}
(\kappa_0\gamma_0)\frac{c^2T^2_e}{L^2_e}\,\gg\, 1.
\end{equation}
Next assume that we are under the calibration
\er{MaxVacFullPPNjjjjffhhGGGGGG}. Then by \er{slowaetherGGaa} and
\er{slochangGGaa} we rewrite \er{MaxVacFullPPNmmmffffffhhtygghGGGG}
and \er{MaxVacFullPPNnnnffffffyuughjhjhjhhjjkjhkkjhhjhghGGGG} as
\begin{equation}\label{MaxVacFullPPNmmmffffffhhtygghGGGGaa}
-\Delta_{\vec x}\Psi_1=4\pi\gamma_0\rho+\frac{1}{c}\,div_{\vec x}
\left\{\left(d_{\vec x}\vec {\tilde u}+\left\{d_{\vec x}\vec {\tilde
u}\right\}^T\right)\cdot\vec A-\left(div_{\vec x}\vec {\tilde
u}\right)\vec A\right\},
\end{equation}
and
\begin{equation}\label{MaxVacFullPPNnnnffffffyuughjhjhjhhjjkjhkkjhhjhghGGGGaa1}
-\Delta_{\vec x}\vec A\approx \frac{4\pi}{\kappa_0 c}\left(\vec
j-\rho\vec {\tilde u}\right)-\frac{1}{\kappa_0\gamma_0
c}\left(\frac{\partial}{\partial t}\left(\nabla_{\vec
x}\Psi_1\right)-curl_{\vec x}\left(\vec {\tilde u}\times\nabla_{\vec
x}\Psi_1\right)+\left(\Delta_{\vec x}\Psi_1\right)\vec {\tilde
u}\right).
\end{equation}
Moreover, by \er{slowaetherGGaa} and \er{slochangGGaa} we can
perform further approximation of
\er{MaxVacFullPPNnnnffffffyuughjhjhjhhjjkjhkkjhhjhghGGGGaa1} and we
get
\begin{multline}\label{MaxVacFullPPNnnnffffffyuughjhjhjhhjjkjhkkjhhjhghGGGGaa}
-\Delta_{\vec x}\vec A\approx\frac{4\pi}{\kappa_0 c}\left(\vec
j-\rho\vec {\tilde u}\right)-\frac{1}{\kappa_0\gamma_0
c}\left(\frac{\partial}{\partial t}\left(\nabla_{\vec
x}\Psi_1\right)-curl_{\vec x}\left(\vec {\tilde u}\times\nabla_{\vec
x}\Psi_1\right)+\left(\Delta_{\vec x}\Psi_1\right)\vec {\tilde
u}\right)
\\\approx\frac{4\pi}{\kappa_0 c}\,\vec
j-\frac{1}{\kappa_0 c}\left(\frac{\partial}{\partial
t}\left(\nabla_{\vec x}\psi_0\right)-curl_{\vec x}\left(\vec
v\times\nabla_{\vec x}\psi_0\right)\right),
\end{multline}
where $\psi_0(\vec x,t)$ is the classical Coulomb's potential which
satisfies
\begin{equation}\label{columbPPNaa}
-\Delta_{\vec x}\psi_0\equiv 4\pi\rho.
\end{equation}
So we rewrite \er{MaxVacFullPPNmmmffffffhhtygghGGGGaa} and
\er{MaxVacFullPPNnnnffffffyuughjhjhjhhjjkjhkkjhhjhghGGGGaa} as
\begin{equation}\label{MaxVacFull1bjkgjhjhgjgjgkjfhjfdghcgjhhjgkgkgugyyurhjfffhfjklhhhgkjgGGaaKK}
\begin{cases}
-\Delta_{\vec x}\vec A \approx\frac{4\pi}{\kappa_0 c}\vec {\widetilde j},\\
-\Delta_{\vec x}\Psi_1=4\pi\gamma_0\rho+\frac{1}{c}\,div_{\vec x}
\left\{\left(d_{\vec x}\vec {\tilde u}+\left\{d_{\vec x}\vec {\tilde
u}\right\}^T\right)\cdot\vec A-\left(div_{\vec x}\vec {\tilde
u}\right)\vec A\right\},
\end{cases}
\end{equation}
where we set the reduced current:
\begin{equation}\label{reducedcurrentfhfhjfhjGGaa}
\begin{cases}
\vec {\widetilde j}:=\vec j-\frac{1}{4\pi}\frac{\partial}{\partial
t} \left(\nabla_{\vec x}\psi_0\right)+\frac{1}{4\pi}curl_{\vec
x}\left(\vec {\tilde u}\times \nabla_{\vec x}\psi_0\right),\\
-\Delta_{\vec x}\psi_0= 4\pi\rho.
\end{cases}
\end{equation}
Note that by the Continuum Equation of the Conservation of Charges:
\begin{equation}\label{toksohraneniezarjadaPPNaa}
\frac{\partial\rho}{\partial t}+div_{\vec x}\vec j\equiv 0,
\end{equation}
the reduced current clearly satisfies:
\begin{equation}\label{divreducedcurrentPPNaa}
div_{\vec x}\vec {\widetilde j}\equiv 0.
\end{equation}
Moreover,  by \er{reducedcurrentfhfhjfhjGGaa} we clearly have
\begin{equation}\label{reducedcurrentPPNuighjhjaa}
\vec {\widetilde j}:=(\vec j-\rho\vec {\tilde
u})-\frac{1}{4\pi}\left(\frac{\partial}{\partial t}
\left(\nabla_{\vec x}\psi_0\right)-curl_{\vec x}\left(\vec {\tilde
u}\times \nabla_{\vec x}\psi_0\right)+\left(div_{\vec
x}\left\{\nabla_{\vec x}\psi_0\right\}\right)\vec {\tilde u}\right),
\end{equation}
and thus, by \er{reducedcurrentPPNuighjhjaa}, using Proposition
\ref{yghgjtgyrtrt}
we deduce that $\vec {\widetilde j}$ is a proper vector field.
Moreover, the approximate vectorial electromagnetic potential $\vec
A$ from
\er{MaxVacFull1bjkgjhjhgjgjgkjfhjfdghcgjhhjgkgkgugyyurhjfffhfjklhhhgkjgGGaaKK}
clearly satisfies:
\begin{equation}\label{MaxVacFull1bjkgjhjhgjgjgkjfhjfdghcgjhhjgkgkgugyyurkkkGGGGGaa}
div_{\vec x}\vec A=0.
\end{equation}
Next, since by \er{vhfffngghhjghhgPPNghghghutghffugghjhjkjjklgg} we
have:
\begin{equation}\label{vhfffngghhjghhgPPNghghghutghffugghjhjkjjklgghkhhh}
\Psi_1:=\Psi-\frac{1}{c}\vec A\cdot\vec {\tilde u},
\end{equation}
and, since by
\er{MaxVacFull1bjkgjhjhgjgjgkjfhjfdghcgjhhjgkgkgugyyurkkkGGGGGaa},
\er{apfrm6} and \er{apfrm9} we have:
\begin{multline}\label{MaxVacFullPPNnnnffffffyuughjhjhjhhjjkjhkkjhhjhghGGGGaajjkkjkjlj}
div_{\vec x} \left\{\left(d_{\vec x}\vec {\tilde u}+\left\{d_{\vec
x}\vec {\tilde u}\right\}^T\right)\cdot\vec A-\left(div_{\vec x}\vec
{\tilde u}\right)\vec A\right\}-\Delta_{\vec x}\left(\vec A\cdot\vec
{\tilde u}\right)=\\div_{\vec x} \left\{\left(d_{\vec x}\vec {\tilde
u}+\left\{d_{\vec x}\vec {\tilde u}\right\}^T\right)\cdot\vec
A-\left(div_{\vec x}\vec {\tilde u}\right)\vec A-\nabla_{\vec
x}\left(\vec A\cdot\vec {\tilde u}\right)\right\}=\\ div_{\vec x}
\left\{d_{\vec x}\vec {\tilde u}\cdot\vec A-d_{\vec x}\vec
A\cdot\vec {\tilde u}+\left(div_{\vec x}\vec A\right)\vec {\tilde
u}-\left(div_{\vec x}\vec {\tilde u}\right)\vec A-\vec {\tilde
u}\times curl_{\vec x}\vec A\right\}\\= div_{\vec x}
\left\{curl_{\vec x}\left(\vec {\tilde u}\times\vec A\right)-\vec
{\tilde u}\times curl_{\vec x}\vec A\right\}= -div_{\vec x}
\left\{\vec {\tilde u}\times curl_{\vec x}\vec A\right\},
\end{multline}
we rewrite
\er{MaxVacFull1bjkgjhjhgjgjgkjfhjfdghcgjhhjgkgkgugyyurhjfffhfjklhhhgkjgGGaaKK}
as:
\begin{equation}\label{MaxVacFull1bjkgjhjhgjgjgkjfhjfdghcgjhhjgkgkgugyyurhjfffhfjklhhhgkjgGGaa}
\begin{cases}
-\Delta_{\vec x}\vec A \approx\frac{4\pi}{\kappa_0 c}\vec {\widetilde j},\\
-\Delta_{\vec x}\Psi= 4\pi\gamma_0\rho-\frac{1}{c}\,div_{\vec
x}\left(\vec {\tilde u}\times curl_{\vec x} \vec A\right).
\end{cases}
\end{equation}
where
\begin{equation}\label{reducedcurrentfhfhjfhjGGaakklkl}
\begin{cases}
\vec {\widetilde j}:=\vec j-\frac{1}{4\pi}\frac{\partial}{\partial
t} \left(\nabla_{\vec x}\psi_0\right)+\frac{1}{4\pi}curl_{\vec
x}\left(\vec {\tilde u}\times \nabla_{\vec x}\psi_0\right),\\
-\Delta_{\vec x}\psi_0= 4\pi\rho.
\end{cases}
\end{equation}
So in order to find the scalar and the vectorial electromagnetic
potentials we just need to solve Laplace equations. Knowing the
approximate electromagnetic potentials by \er{vhfffngghPPN333yuyuGG}
we can find the approximations of of the electromagnetic fields:
\begin{equation}\label{MaxVacFull1bjkgjhjhgjgjgkjfhjfdghghligioiuittrhiguffGGaa}
\begin{cases}
\vec B= curl_{\vec x} \vec A\\
\vec E=-\nabla_{\vec x}\Psi-\frac{1}{c}\frac{\partial\vec
A}{\partial t}\\
 \vec D=-\frac{1}{\gamma_0}\nabla_{\vec
x}\Psi-\frac{1}{\gamma_0 c}\frac{\partial\vec A}{\partial t}+\frac{1}{c\gamma_0}\vec {\tilde u}\times curl_{\vec x}\vec A\\
\vec H=\kappa_0 \,curl_{\vec x} \vec A+\frac{1}{c}\,\vec {\tilde
u}\times\left(-\frac{1}{\gamma_0}\nabla_{\vec
x}\Psi-\frac{1}{\gamma_0 c}\frac{\partial\vec A}{\partial
t}+\frac{1}{\gamma_0 c}\vec {\tilde u}\times curl_{\vec x}\vec
A\right),
\end{cases}
\end{equation}
where $\Psi$ and $\vec A$ are given by
\er{MaxVacFull1bjkgjhjhgjgjgkjfhjfdghcgjhhjgkgkgugyyurhjfffhfjklhhhgkjgGGaa}.
Note also that, since $\vec {\widetilde j}$ is a proper vector
field, by Proposition \ref{yghgjtgyrtrt} we deduce that equations
\er{MaxVacFull1bjkgjhjhgjgjgkjfhjfdghcgjhhjgkgkgugyyurhjfffhfjklhhhgkjgGGaaKK}
and thus also equations
\er{MaxVacFull1bjkgjhjhgjgjgkjfhjfdghcgjhhjgkgkgugyyurhjfffhfjklhhhgkjgGGaa}
are invariant under the change of non-inertial cartesian coordinate
system, provided that $\vec A$ is a proper vector field and
$\Psi_1=\Psi-\frac{1}{c}\vec A\cdot\vec {\tilde u}$ is a proper
scalar field. So the approximate solutions in the case of
quasistationary fields in a weak gravitational field satisfy the
same transformation as the exact solutions of Maxwell Equations.
Therefore, if in coordinate system $(*)$ we can use the approximate
equations, given by
\er{MaxVacFull1bjkgjhjhgjgjgkjfhjfdghcgjhhjgkgkgugyyurhjfffhfjklhhhgkjgGGaa}
and \er{MaxVacFull1bjkgjhjhgjgjgkjfhjfdghghligioiuittrhiguffGGaa},
then we can use the similar approximation
also in coordinate system $(**)$, even in the case when in system
$(**)$ \er{slowaetherGGaa} or \er{slochangGGaa} are not satisfied.
\begin{remark}
The solutions of
\er{MaxVacFull1bjkgjhjhgjgjgkjfhjfdghcgjhhjgkgkgugyyurhjfffhfjklhhhgkjgGGaa}
and \er{MaxVacFull1bjkgjhjhgjgjgkjfhjfdghghligioiuittrhiguffGGaa}
satisfy the following equations:
\begin{equation}\label{MaxVacFull1bjkgjhjhgjaaajhfghhgGGaa}
\begin{cases}
curl_{\vec x} \left(\kappa_0\vec B+\frac{1}{c}\,\vec {\tilde
u}\times \left(- \nabla_{\vec x}\psi_0\right)\right)\equiv
\frac{4\pi}{c}\vec j+\frac{1}{c}\frac{\partial (-
\nabla_{\vec x}\psi_0)}{\partial t},\\
div_{\vec x} \vec D=4\pi\rho,\\
curl_{\vec x} \vec E+\frac{1}{c}\frac{\partial \vec B}{\partial t}=0,\\
div_{\vec x} \vec B=0\\
\vec E=\gamma_0\vec D-\frac{1}{c}\,\vec {\tilde u}\times \vec B\\
\vec H=\kappa_0\vec B+\frac{1}{c}\,\vec {\tilde u}\times \vec D,\\
\vec {\tilde u}=\left(\gamma_0\kappa_0\vec
v+(1-\gamma_0\kappa_0)\vec u\right),
\end{cases}
\end{equation}
where $\psi_0$ was defined by \er{columbPPNaa}. Equations
\er{MaxVacFull1bjkgjhjhgjaaajhfghhgGGaa} differ from the original
Maxwell equations \er{MaxVacFullPPNffGG} only by neglecting the
divergence-free part of the vector field $\vec D$ on the first
equation.
\end{remark}

Next, assume that, in addition to the validity of approximation
\er{slowaetherGGaa} and \er{slochangGGaa}, the approximation
\er{MaxVacFullPPNmmmffffffhhtygghGGGGyuhggghgh} also holds. Then we
further approximate
\er{MaxVacFull1bjkgjhjhgjgjgkjfhjfdghcgjhhjgkgkgugyyurhjfffhfjklhhhgkjgGGaaKK}
as:
\begin{equation}\label{MaxVacFull1bjkgjhjhgjgjgkjfhjfdghcgjhhjgkgkgugyyurhjfffhfjklhhhgkjgGGaaKKjkjj}
\begin{cases}
-\Delta_{\vec x}\Psi_1=4\pi\gamma_0\rho,\\
-\Delta_{\vec x}\vec A \approx \frac{4\pi}{\kappa_0 c}\,\vec
j-\frac{1}{\kappa_0\gamma_0 c}\left(\frac{\partial}{\partial
t}\left(\nabla_{\vec
x}\Psi_1\right)-curl_{\vec x}\left(\vec {\tilde u}\times\nabla_{\vec x}\Psi_1\right)\right)\\
\Psi=\Psi_1+\frac{1}{c}\vec A\cdot\vec {\tilde u}.
\end{cases}
\end{equation}
Moreover, as before, we deduce that equations
\er{MaxVacFull1bjkgjhjhgjgjgkjfhjfdghcgjhhjgkgkgugyyurhjfffhfjklhhhgkjgGGaaKKjkjj}
are also invariant under the change of non-inertial cartesian
coordinate system. Therefore, as before, if in coordinate system
$(*)$ we can use the approximation equations, given by
\er{MaxVacFull1bjkgjhjhgjgjgkjfhjfdghcgjhhjgkgkgugyyurhjfffhfjklhhhgkjgGGaaKKjkjj}
then we can use the similar equations
also in coordinate system $(**)$, even in the case when in system
$(**)$ \er{slowaetherGGaa}, \er{slochangGGaa} or
\er{MaxVacFullPPNmmmffffffhhtygghGGGGyuhggghgh} are not satisfied.

Finally, assume that we are under the alternative calibration
\er{MaxVacFullPPNjjjjffhhGG}. Then by \er{slowaetherGGaa} and
\er{slochangGGaa} we rewrite \er{MaxVacFullPPNmmmffffffiuiuhjuGG}
and \er{MaxVacFullPPNnnnffffffyuughjhjhjhhjjkjhkkjhujgGG} as:
\begin{equation}\label{MaxVacFullPPNmmmffffffiuiuhjuGGaa}
-\Delta_{\vec x}\Psi_1\approx
4\pi\gamma_0\rho+\frac{1}{c}\,div_{\vec x} \left\{\left(d_{\vec
x}\vec {\tilde u}+\left\{d_{\vec x}\vec {\tilde
u}\right\}^T\right)\cdot\vec A-\left(div_{\vec x}\vec {\tilde
u}\right)\vec A\right\},
\end{equation}
and
\begin{equation}\label{MaxVacFullPPNnnnffffffyuughjhjhjhhjjkjhkkjhujgGGaa}
-\Delta_{\vec x}\vec A\approx \frac{4\pi}{\kappa_0 c}\left(\vec
j-\rho\vec {\tilde u}\right)+\frac{1}{\kappa_0\gamma_0
c}\left(\left(d_{\vec x}\vec {\tilde u}+\left\{d_{\vec x}\vec
{\tilde u}\right\}^T\right)\cdot \nabla_{\vec
x}\Psi_1-\left(div_{\vec x}\vec {\tilde u}\right)\nabla_{\vec
x}\Psi_1\right).
\end{equation}
Thus if we assume that in addition to the approximation
\er{slowaetherGGaa} and \er{slochangGGaa} the approximation
\er{MaxVacFullPPNmmmffffffhhtygghGGGGyuhggghgh} also holds, we
further approximate \er{MaxVacFullPPNmmmffffffiuiuhjuGGaa} and
\er{MaxVacFullPPNnnnffffffyuughjhjhjhhjjkjhkkjhujgGGaa} as:
\begin{equation}\label{MaxVacFullPPNmmmffffffiuiuhjuGGaaghgghgh}
\begin{cases}
-\Delta_{\vec x}\Psi_1\approx 4\pi\gamma_0\rho,
\\
-\Delta_{\vec x}\vec A\approx \frac{4\pi}{\kappa_0 c}\left(\vec
j-\rho\vec {\tilde u}\right)\\
\Psi=\Psi_1+\frac{1}{c}\vec A\cdot\vec {\tilde u}.
\end{cases}
\end{equation}
Moreover, as before, we deduce that equations
\er{MaxVacFullPPNmmmffffffiuiuhjuGGaaghgghgh} are also invariant
under the change of non-inertial cartesian coordinate system.
Therefore, as before, if in coordinate system $(*)$ we can use the
approximation equations, given by
\er{MaxVacFullPPNmmmffffffiuiuhjuGGaaghgghgh}
then we can use the similar equations
also in coordinate system $(**)$, even in the case when in system
$(**)$ \er{slowaetherGGaa}, \er{slochangGGaa} or
\er{MaxVacFullPPNmmmffffffhhtygghGGGGyuhggghgh} are not satisfied.

\subsubsection{Hertz's notation for quasistationary electromagnetic fields}\label{gcCMuhhjhggh} Again consider in some domain the full system of
Maxwell equations in the vacuum or in a medium, without dispersion,
having the form \er{MaxVacFullPPNffGG}:
\begin{equation}\label{MaxVacFullPPNffGGvbccgc}
\begin{cases}
curl_{\vec x} \vec H=\frac{4\pi}{c}\vec j+
\frac{1}{c}\frac{\partial \vec D}{\partial t},\\
div_{\vec x} \vec D=4\pi\rho,\\
curl_{\vec x} \vec E+\frac{1}{c}\frac{\partial \vec B}{\partial t}=0,\\
div_{\vec x} \vec B=0\\
\vec E=\gamma_0\vec D-\frac{1}{c}\,\vec {\tilde u}\times \vec B\\
\vec H=\kappa_0\vec B+\frac{1}{c}\,\vec {\tilde u}\times \vec D,\\
\vec {\tilde u}=\left(\gamma_0\kappa_0\vec
v+(1-\gamma_0\kappa_0)\vec u\right).
\end{cases}
\end{equation}
Next define the electromagnetic fields in Hertz's notation:
\begin{equation}\label{MaxVacFullPPNffGGvbccgcjhhhjjihihjh}
\begin{cases}
\vec{\tilde{ E}}:=\vec E+\frac{1}{c}\,\vec {u}\times \vec B,\\
\vec{\tilde{ D}}:=\frac{1}{\gamma_0}\vec{\tilde{ E}}=\frac{1}{\gamma_0}\left(\vec E+\frac{1}{c}\,\vec {u}\times \vec B\right),\\
\vec{\tilde{ B}}:=\frac{1}{\kappa_0}\left(\vec H-\frac{1}{c}\,\vec {u}\times \vec D\right),\\
\vec{\tilde{ H}}:=\kappa_0\vec{\tilde{ B}}=\left(\vec
H-\frac{1}{c}\,\vec {u}\times \vec D\right).
\end{cases}
\end{equation}
Note here that $\vec u$ is the velocity field in the medium, thus
since this field is absent in vacuum we use the convention that for
the vacuum we have $\vec u:=\vec v$ (and in addition $\vec {\tilde
u}=\vec v$). In particular, by
\er{MaxVacFullPPNffGGvbccgcjhhhjjihihjh} and
\er{MaxVacFullPPNffGGvbccgc} we have
\begin{equation}\label{MaxVacFullPPNffGGvbccgcjhhhjjihihjhhhhiuu8hjiy}
\begin{cases}
\vec{\tilde{ E}}=\gamma_0\vec D+\frac{1}{c}\,\left(\vec {u}-\vec {\tilde u}\right)\times \vec B=\gamma_0\left(\vec D+\frac{\kappa_0}{c}\,\left(\vec {u}-\vec v\right)\times \vec B\right),\\
\vec{\tilde{ D}}=\vec D+\frac{\kappa_0}{c}\,\left(\vec {u}-\vec v\right)\times \vec B,\\
\vec{\tilde{ B}}=\frac{1}{\kappa_0}\left(\kappa_0\vec B-\frac{1}{c}\,\left(\vec {u}-\vec {\tilde u}\right)\times \vec D\right)=\vec B-\frac{\gamma_0}{c}\,\left(\vec {u}-\vec v\right)\times \vec D,\\
\vec{\tilde{ H}}=\kappa_0\vec
B-\frac{\gamma_0\kappa_0}{c}\,\left(\vec {u}-\vec v\right)\times
\vec D.
\end{cases}
\end{equation}
Therefore, since $\vec D,\vec B, (\vec u-\vec v)$ are all proper
vector fields, we deduce that $\vec{\tilde{ E}}, \vec{\tilde{ D}},
\vec{\tilde{ B}}, \vec{\tilde{ H}}$ are all \underline{proper}
vector fields. Furthermore, by
\er{MaxVacFullPPNffGGvbccgcjhhhjjihihjh} and
\er{MaxVacFullPPNffGGvbccgcjhhhjjihihjhhhhiuu8hjiy} we rewrite the
first four equations in \er{MaxVacFullPPNffGGvbccgc} as:
\begin{equation}\label{MaxVacFullPPNffGGvbccgcu8uuuuu}
\begin{cases}
curl_{\vec x} \vec{\tilde{ H}}=\frac{4\pi}{c}\vec j+
\frac{1}{c}\frac{\partial \vec D}{\partial t}-\frac{1}{c}\,curl_{\vec x}\left(\vec {u}\times \vec D\right),\\
div_{\vec x} \vec{\tilde{ D}}=4\pi\rho+\frac{1}{c}\,div_{\vec x} \left(\kappa_0\left(\vec {u}-\vec v\right)\times \vec B\right),\\
curl_{\vec x} \vec{\tilde{ E}}=-\left(\frac{1}{c}\frac{\partial \vec B}{\partial t}-\frac{1}{c}\,curl_{\vec x}\left(\vec {u}\times \vec B\right)\right),\\
div_{\vec x} \vec{\tilde{ B}}=-\frac{1}{c}\,div_{\vec x}
\left(\gamma_0\left(\vec {u}-\vec v\right)\times \vec D\right).
\end{cases}
\end{equation}
Then by inserting \er{apfrm3} into
\er{MaxVacFullPPNffGGvbccgcu8uuuuu} we deduce:
\begin{equation}\label{MaxVacFullPPNffGGvbccgcu8uuuuukjjkhh}
\begin{cases}
curl_{\vec x} \vec{\tilde{ H}}=\frac{4\pi}{c}\vec j+
\frac{1}{c}\frac{\partial \vec D}{\partial t}-\frac{1}{c}\,curl_{\vec x}\left(\vec {u}\times \vec D\right),\\
div_{\vec x} \vec{\tilde{ D}}=4\pi\rho-\frac{1}{c}\left(\vec {u}-\vec v\right)\cdot\,curl_{\vec x} \left(\kappa_0\vec B\right)+\frac{\kappa_0}{c}\,\vec B\cdot curl_{\vec x} \left(\left(\vec {u}-\vec v\right)\right),\\
curl_{\vec x} \vec{\tilde{ E}}=-\left(\frac{1}{c}\frac{\partial \vec B}{\partial t}-\frac{1}{c}\,curl_{\vec x}\left(\vec {u}\times \vec B\right)\right),\\
div_{\vec x} \vec{\tilde{ B}}=\frac{1}{c}\left(\vec {u}-\vec
v\right)\cdot\,curl_{\vec x} \left(\gamma_0\vec
D\right)-\frac{\gamma_0}{c}\,\vec D\cdot curl_{\vec x}
\left(\left(\vec {u}-\vec v\right)\right).
\end{cases}
\end{equation}
Therefore, inserting \er{MaxVacFullPPNffGGvbccgc} into the second
and the forth equations in \er{MaxVacFullPPNffGGvbccgcu8uuuuukjjkhh}
we deduce:
\begin{equation}\label{MaxVacFullPPNffGGvbccgcu8uuuuukjjkhhhihjhj}
\begin{cases}
curl_{\vec x} \vec{\tilde{ H}}=\frac{4\pi}{c}\vec j+
\frac{1}{c}\frac{\partial \vec D}{\partial t}-\frac{1}{c}\,curl_{\vec x}\left(\vec {u}\times \vec D\right),\\
div_{\vec x} \vec{\tilde{ D}}=4\pi\left(\rho-\frac{1}{c^2}\left(\vec
{u}-\vec v\right)\cdot\vec j\right)-\frac{1}{c^2}\left(\vec {u}-\vec
v\right)\cdot\left(
\frac{\partial \vec D}{\partial t}-curl_{\vec x}\left(\vec {\tilde u}\times \vec D\right)\right)+\frac{\kappa_0}{c}\,\vec B\cdot curl_{\vec x} \left(\left(\vec {u}-\vec v\right)\right),\\
curl_{\vec x} \vec{\tilde{ E}}=-\left(\frac{1}{c}\frac{\partial \vec B}{\partial t}-\frac{1}{c}\,curl_{\vec x}\left(\vec {u}\times \vec B\right)\right),\\
div_{\vec x} \vec{\tilde{ B}}=-\frac{1}{c^2}\left(\vec {u}-\vec
v\right)\cdot\left(\left(\frac{\partial \vec B}{\partial
t}-curl_{\vec x}\left(\vec {\tilde u}\times \vec
B\right)\right)\right)-\frac{\gamma_0}{c}\,\vec D\cdot curl_{\vec x}
\left(\left(\vec {u}-\vec v\right)\right).
\end{cases}
\end{equation}

Next, analogously with \er{slowaetherGGaa}, assume  that in the
given inertial or non-inertial cartesian coordinate system $(*)$ the
fields $\vec u$, $\vec v$, $\vec {\tilde u}$ are weak, meaning that
at any instant on every point:
\begin{equation}\label{slowaetherGGaagghgh}
\frac{1}{\kappa_0\gamma_0}\left(\frac{|\vec {
u}|^2}{c^2}+\frac{|\vec v|^2}{c^2}+\frac{|\vec {\tilde
u}|^2}{c^2}\right)\,\ll\, 1,
\end{equation}
and consider quasistationary electromagnetic fields, meaning the
following: assume that the changes in time of the physical
characteristics of the electromagnetic fields become essential after
certain interval of time $T_e$ and the changes in space of the
physical characteristics of the fields become essential in the
spatial landscape $L_e$. Then, analogously with \er{slochangGGaa},
we assume that
\begin{equation}\label{slochangGGaajhhjhj}
(\kappa_0\gamma_0)\frac{c^2T^2_e}{L^2_e}\,\gg\, 1.
\end{equation}
Moreover, assume that we have the following approximation analogous
to \er{MaxVacFullPPNmmmffffffhhtygghGGGGyuhggghgh}: if the changes
in space of the physical characteristics of the electromagnetic
fields become essential in the spatial landscape $L_e$ and the
changes in space of the fields $\vec u$, $\vec v$, $\vec {\tilde u}$
become essential in the spatial landscape $L_{u}$, then we assume
\begin{equation}\label{MaxVacFullPPNmmmffffffhhtygghGGGGyuhggghghffggfgf}
L_e\ll L_u,
\end{equation}
i.e. the fields $\vec u$, $\vec v$, $\vec {\tilde u}$ change in
space much slower than the electromagnetic fields. Then by
\er{slowaetherGGaagghgh}, \er{slochangGGaajhhjhj} and
\er{MaxVacFullPPNmmmffffffhhtygghGGGGyuhggghghffggfgf}, using the
second and the third equation in
\er{MaxVacFullPPNffGGvbccgcjhhhjjihihjhhhhiuu8hjiy} ,we approximate
\er{MaxVacFullPPNffGGvbccgcu8uuuuukjjkhhhihjhj} as:
\begin{equation}\label{MaxVacFullPPNffGGvbccgcu8uuuuukjjkhhhihjhjkhhkmgjhjhhkhjh}
\begin{cases}
curl_{\vec x} \vec{\tilde{ H}}\approx\frac{4\pi}{c}\vec j+
\frac{1}{c}\frac{\partial \vec{\tilde{ D}}}{\partial
t}-\frac{1}{c}\,curl_{\vec x}\left(\vec {u}\times \vec{\tilde{
D}}\right),\\
div_{\vec x} \vec{\tilde{ D}}\approx
4\pi\left(\rho-\frac{1}{c^2}\left(\vec {u}-\vec
v\right)\cdot\left(\vec
j-\rho\vec u\right)\right),\\
curl_{\vec x} \vec{\tilde{
E}}\approx-\left(\frac{1}{c}\frac{\partial \vec{\tilde{
B}}}{\partial t}-\frac{1}{c}\,curl_{\vec x}\left(\vec {u}\times
\vec{\tilde{
B}}\right)\right),\\
div_{\vec x} \vec{\tilde{ B}}\approx 0.
\end{cases}
\end{equation}
Moreover, denoting
\begin{equation}\label{MaxVacFullPPNffGGvbccgcjhhhjjihihjhkjkhkhhk}
\begin{cases}
\tilde\rho^*:=\rho-\frac{1}{c^2}\left(\vec {u}-\vec
v\right)\cdot\left(\vec j-\rho\vec u\right),\\
\vec {\tilde j}^*=\left(\vec j-\rho\vec u\right)+\tilde\rho^*\vec u,
\end{cases}
\end{equation}
again by \er{slowaetherGGaagghgh}, \er{slochangGGaajhhjhj} and
\er{MaxVacFullPPNmmmffffffhhtygghGGGGyuhggghghffggfgf}, using the
equality of the conservation of the charge
$\frac{\partial\rho}{\partial t}+div_{\vec x}\vec j=0$ we deduce:
\begin{equation}\label{MaxVacFullPPNffGGvbccgcjhhhjjihihjhkjkhkhhkhkhhjgg}
\begin{cases}
\vec j\approx\vec {\tilde j^*},
\\
\frac{\partial\tilde\rho^*}{\partial t}+div_{\vec x}\vec {\tilde
j^*}\approx 0.
\end{cases}
\end{equation}
Moreover, by \er{MaxVacFullPPNffGGvbccgcjhhhjjihihjhkjkhkhhk} we
easily obtain that $\tilde\rho^*$ is a proper scalar and $\left(\vec
{\tilde j^*}-\tilde\rho^*\vec u\right)$ is a proper vector. Finally,
by \er{slowaetherGGaagghgh},
\er{MaxVacFullPPNffGGvbccgcjhhhjjihihjhhhhiuu8hjiy} and
\er{MaxVacFullPPNffGGvbccgcjhhhjjihihjh} we derive the following
approximate equations:
\begin{equation}\label{MaxVacFullPPNffGGvbccgcjhhhjjihihjhjhjhjjg}
\begin{cases}
\vec B\approx\vec{\tilde{
B}}+\frac{1}{\kappa_0}\frac{1}{c}\,\left(\vec {u}-\vec
v\right)\times \vec{\tilde{ E}},\\
\vec E=\vec{\tilde{ E}}-\frac{1}{c}\,\vec {u}\times \vec{ B}.
\end{cases}
\end{equation}
So, by
\er{MaxVacFullPPNffGGvbccgcu8uuuuukjjkhhhihjhjkhhkmgjhjhhkhjh},
\er{MaxVacFullPPNffGGvbccgcjhhhjjihihjhkjkhkhhk},
\er{MaxVacFullPPNffGGvbccgcjhhhjjihihjhkjkhkhhkhkhhjgg} and
\er{MaxVacFullPPNffGGvbccgcjhhhjjihihjh} we deduce the following
approximate Maxwell equations in Hertz's form and write them
together with \er{MaxVacFullPPNffGGvbccgcjhhhjjihihjhjhjhjjg} as the
following two sets of equations:
\begin{equation}\label{MaxVacFullPPNffGGvbccgcu8uuuuukjjkhhhihjhjkhhkmgjhjhhkhjhhjjhjhjjh}
\begin{cases}
curl_{\vec x} \left(\kappa_0\vec{\tilde{
B}}\right)\approx\frac{4\pi}{c}\vec {\tilde j^*}+
\frac{1}{c}\frac{\partial }{\partial
t}\left(\frac{1}{\gamma_0}\vec{\tilde{
E}}\right)-\frac{1}{c}\,curl_{\vec x}\left(\vec
{u}\times\frac{1}{\gamma_0} \vec{\tilde{
E}}\right),\\
div_{\vec x} \left(\frac{1}{\gamma_0}\vec{\tilde{ E}}\right)\approx
4\pi\tilde\rho^*,\\
curl_{\vec x} \vec{\tilde{
E}}\approx-\left(\frac{1}{c}\frac{\partial \vec{\tilde{
B}}}{\partial t}-\frac{1}{c}\,curl_{\vec x}\left(\vec {u}\times
\vec{\tilde{
B}}\right)\right),\\
div_{\vec x} \vec{\tilde{ B}}\approx 0,\\
\frac{\partial\tilde\rho^*}{\partial t}+div_{\vec x}\vec {\tilde
j^*}\approx 0,
\end{cases}
\end{equation}
and
\begin{equation}\label{MaxVacFullPPNffGGvbccgcu8uuuuukjjkhhhihjhjkhhkmgjhjhhkhjhhjjhjhjjh11}
\begin{cases}
\rho=\tilde\rho^*+\frac{1}{c^2}\frac{1}{\kappa_0}\left(\vec {u}-\vec
v\right)\cdot\left(\vec {\tilde j^*}-\tilde\rho^*\vec u\right),\\
\vec j=\left(\vec {\tilde j^*}-\tilde\rho^*\vec u\right)+\rho\vec u,
\\
\vec B\approx\vec{\tilde{ B}}+\frac{1}{c}\,\left(\vec {u}-\vec
v\right)\times \vec{\tilde{ E}}\\
\vec E=\vec{\tilde{ E}}-\frac{1}{c}\,\vec {u}\times \vec{ B},
\end{cases}
\end{equation}
that are valid, however, only for quasistationary electromagnetic
fields. Next, since $\vec{\tilde{ E}}, \vec{\tilde{ D}},
\vec{\tilde{ B}}, \vec{\tilde{ H}}$, $\vec B,\left(\vec
E+\frac{1}{c}\,\vec {u}\times \vec{ B}\right)$ and $\left(\vec
{\tilde j^*}-\tilde\rho^*\vec u\right),\left(\vec {j}-\rho\vec
u\right),\left(\vec {u}-\vec v\right) $ are all proper vector fields
and $\tilde\rho^*,\rho$ are proper scalars, as before we deduce that
all equations in
\er{MaxVacFullPPNffGGvbccgcu8uuuuukjjkhhhihjhjkhhkmgjhjhhkhjhhjjhjhjjh}
and
\er{MaxVacFullPPNffGGvbccgcu8uuuuukjjkhhhihjhjkhhkmgjhjhhkhjhhjjhjhjjh11}
are invariant under the change of inertial or non-inertial cartesian
coordinate system. Therefore, if in coordinate system $(*)$ we can
use the approximate equations, given by
\er{MaxVacFullPPNffGGvbccgcu8uuuuukjjkhhhihjhjkhhkmgjhjhhkhjhhjjhjhjjh}
and
\er{MaxVacFullPPNffGGvbccgcu8uuuuukjjkhhhihjhjkhhkmgjhjhhkhjhhjjhjhjjh11},
then we can use the similar approximation
also in coordinate system $(**)$, even in the case when in system
$(**)$ \er{slowaetherGGaagghgh}, \er{slochangGGaajhhjhj} or
\er{MaxVacFullPPNmmmffffffhhtygghGGGGyuhggghghffggfgf} are not
satisfied.
Next, by \er{vjhfhjtjhjhuyyiyGG} we adjoint
\er{MaxVacFullPPNffGGvbccgcu8uuuuukjjkhhhihjhjkhhkmgjhjhhkhjhhjjhjhjjh}
and
\er{MaxVacFullPPNffGGvbccgcu8uuuuukjjkhhhihjhjkhhkmgjhjhhkhjhhjjhjhjjh11}
with the Ohm's Law that in our notations has the following simple
form:
\begin{equation}\label{vjhfhjtjhjhuyyiyGGhhgjg}
\vec {\tilde j^*}-\tilde\rho^*\,\vec u=\varepsilon\vec{\tilde E},
\end{equation}
and the Joules heat term, appearing in the First Law of
Thermodynamics
\er{MaxVacFull1ninshtrgravortghhghgjkgghklhjgkghghjjkjhjkkggjkhjkhjjhhfhjhkjkhbbgjhzzzyyykkknnnNWBWHWPPNjjhjhjhjh},
in our notations has the following simple form
\begin{equation}\label{vjhiiiihhjjhjh}\left(\vec j-\rho\vec u\right)\cdot\left(\vec E+\frac{1}{c}\vec
u\times\vec B\right)= \left(\vec {\tilde j^*}-\tilde\rho^*\,\vec
u\right)\cdot\vec{\tilde E}\,.
\end{equation}

Finally, note that the system
\er{MaxVacFullPPNffGGvbccgcu8uuuuukjjkhhhihjhjkhhkmgjhjhhkhjhhjjhjhjjh}
is completely analogous to the system \er{MaxVacFullPPNffGGvbccgc},
where the field $\frac{1}{\gamma_0}\vec{\tilde{ E}}$ substitutes the
field $\vec D$, the field $\vec{\tilde{ B}}$ substitutes the field
$\vec B$, $\tilde\rho^*,\vec{\tilde j^*}$ substitute $\rho,\vec{j}$,
and with the real velocity of the medium $\vec u$ instead of the
field $\vec {\tilde u}=\left(\gamma_0\kappa_0\vec
v+(1-\gamma_0\kappa_0)\vec u\right)$. On the other hand, in the case
of the real medium (not vacuum) the equations in
\er{MaxVacFullPPNffGGvbccgcu8uuuuukjjkhhhihjhjkhhkmgjhjhhkhjhhjjhjhjjh}
are completely independent of the vectorial gravitation potential
$\vec v$. Since the systems
\er{MaxVacFullPPNffGGvbccgcu8uuuuukjjkhhhihjhjkhhkmgjhjhhkhjhhjjhjhjjh}
and \er{MaxVacFullPPNffGGvbccgc} are similar they both obeys
wave-type solutions and thus one can wonder here: is it possible
that
\er{MaxVacFullPPNffGGvbccgcu8uuuuukjjkhhhihjhjkhhkmgjhjhhkhjhhjjhjhjjh}
is approximately equivalent to \er{MaxVacFullPPNffGGvbccgc} also for
highly oscillating in time electromagnetic fields and/or for
wave-type solutions in the absence of charges and currents, for
example in the case of the visible light? Unfortunately, the answer
for the last question should be negative, indeed as it can be seen,
the term $(1-\gamma_0\kappa_0)\vec u$ (consistently with the
equality $\vec {\tilde u}=\left(\gamma_0\kappa_0\vec
v+(1-\gamma_0\kappa_0)\vec u\right)$) appear in the the Fizeau
experiment for the light (see subsection \ref{seGOfz}) rather than
the term $\vec u$. The main reason for it is that
\er{slochangGGaajhhjhj} is not valid for highly oscillating in time
electromagnetic fields. Moreover, in the case of the complete
absence of charges and currents in the whole space
\er{slochangGGaajhhjhj} is not valid also for any nontrivial
wave-type solution, even with moderate frequency. So
\er{MaxVacFullPPNffGGvbccgcu8uuuuukjjkhhhihjhjkhhkmgjhjhhkhjhhjjhjhjjh}
and
\er{MaxVacFullPPNffGGvbccgcu8uuuuukjjkhhhihjhjkhhkmgjhjhhkhjhhjjhjhjjh11}
are valid, only for quasistationary electromagnetic fields,
generated by charges and currents moderately changing in time, for
example in the case of estimation of DC or AC chains.

Next, as before, again by the third and the fourth equations in
\er{MaxVacFullPPNffGGvbccgcu8uuuuukjjkhhhihjhjkhhkmgjhjhhkhjhhjjhjhjjh}
we can write
\begin{equation}\label{MaxVacFull1bjkgjhjhgjgjgkjfhjfdghghligioiuittrPPNggjoiuoui}
\begin{cases}
\vec {\tilde B}\approx curl_{\vec x} \vec {\tilde A^*},\\
\vec {\tilde E}\approx -\nabla_{\vec
x}\tilde\Psi^*-\frac{1}{c}\frac{\partial\vec {\tilde A^*}}{\partial
t}+\frac{1}{c}\,\vec {u}\times \left(curl_{\vec x} \vec {\tilde
A^*}\right),
\end{cases}
\end{equation}
where $\tilde\Psi^*$ and $\vec {\tilde A^*}$ are the scalar and the
vectorial electromagnetic potentials in Hertz's notation. Next,
analogously to \er{vhfffngghhjghhgPPNghghghutghffugghjhjkjjklgg} we
define the proper scalar electromagnetic potential in Hertz's
notation:
\begin{equation}\label{vhfffngghhjghhgPPNghghghutghffugghjhjkjjklgghihhih}
\tilde\Psi^*_1:=\tilde\Psi^*-\frac{1}{c}\vec {\tilde A^*}\cdot\vec
{u}.
\end{equation}
As before, the electromagnetic potentials in Hertz's notation are
not uniquely defined and thus we need to choose a calibration. For
definiteness we can take $\vec {\tilde A^*}$ to satisfy
\begin{equation}\label{MaxVacFull1bjkgjhjhgjgjgkjfhjfdghghligioiuittrPPN22jnhhjhjh}
div_{\vec x}\vec {\tilde A^*}= 0.
\end{equation}
Then, as before, $\vec {\tilde A^*}$ is a proper vector field and
$\tilde\Psi^*_1$ is a proper scalar field. It is clear that any
other choice of electromagnetic potentials with a different
calibration, as before, can be obtained by
\begin{equation}\label{MaxVacFull1bjkgjhjhgjgjgkjfhjfdghghligioiuittrPPNhjkjhkjgghhjjhjjhjgj}
\begin{cases}
\tilde\Psi^*\,\rightarrow\, \tilde\Psi^*+\frac{1}{c}\frac{\partial w}{\partial t}\\
\vec {\tilde A^*}\,\rightarrow\, \vec {\tilde A^*}-\nabla_{\vec x}w\\
\tilde\Psi^*_1\,\rightarrow\,
\tilde\Psi^*_1+\frac{1}{c}\left(\frac{\partial w}{\partial t}+\vec
u\cdot\nabla_{\vec x}w\right).
\end{cases}
\end{equation}
where $w:=w(\vec x,t)$ is an arbitrary proper scalar field.
Moreover, as before, under such a change of calibration, $\vec
{\tilde A^*}$ is still a proper vector field and $\tilde\Psi^*_1$ is
still a proper scalar field, provided the scalar field $w$ is
\underline{proper}. In particular the following alternative to
\er{MaxVacFull1bjkgjhjhgjgjgkjfhjfdghghligioiuittrPPN22jnhhjhjh}
calibration of the potentials can be chosen:
\begin{equation}\label{MaxVacFullPPNjjjjffhhGGljkjjkkjhk}
\frac{1}{\kappa_0\gamma_0
c}\left(\frac{\partial\tilde\Psi^*_1}{\partial t}+\vec
{u}\cdot\nabla_{\vec x}\tilde\Psi^*_1\right)+div_{\vec x}\vec
{\tilde A^*}=0.
\end{equation}

Next, inserting
\er{vhfffngghhjghhgPPNghghghutghffugghjhjkjjklgghihhih} into
\er{MaxVacFull1bjkgjhjhgjgjgkjfhjfdghghligioiuittrPPNggjoiuoui}
gives
\begin{equation}\label{MaxVacFull1bjkgjhjhgjgjgkjfhjfdghghligioiuittrPPNggjoiuouighujjhmngkh}
\begin{cases}
\vec {\tilde B}\approx curl_{\vec x} \vec {\tilde A^*},\\
\vec {\tilde E}\approx -\nabla_{\vec
x}\tilde\Psi^*_1-\frac{1}{c}\left(\frac{\partial\vec {\tilde
A^*}}{\partial t}-\vec {u}\times \left(curl_{\vec x} \vec {\tilde
A^*}\right)+\nabla_{\vec x}\left(\vec {\tilde A^*}\cdot\vec
{u}\right)\right).
\end{cases}
\end{equation}
Therefore, by \er{vfyutuyfffhhgfhgfh} and
\er{vhfffngghhjghhgjlkhjhkPPP} in Proposition \ref{yghgjtgyrtrt} we
rewrite
\er{MaxVacFull1bjkgjhjhgjgjgkjfhjfdghghligioiuittrPPNggjoiuouighujjhmngkh}
as:
\begin{equation}\label{MaxVacFull1bjkgjhjhgjgjgkjfhjfdghghligioiuittrPPNggjoiuouighujjhmngkhjkhjjhjh}
\begin{cases}
\vec {\tilde B}\approx curl_{\vec x} \vec {\tilde A^*},\\
\vec {\tilde E}\approx -\nabla_{\vec
x}\tilde\Psi^*_1-\frac{1}{c}\left(\frac{\partial\vec {\tilde
A^*}}{\partial t}-\,curl_{\vec x}\left(\vec {u}\times \vec {\tilde
A^*}\right)+\left(div_{\vec x}\vec {\tilde A^*}\right)\vec
{u}-\left(div_{\vec x}\vec u\right)\vec {\tilde A^*}+ \left(d_{\vec
x}\vec u+\left\{d_{\vec x}\vec u\right\}^T\right)\cdot\vec {\tilde
A^*}\right).
\end{cases}
\end{equation}
Thus, by \er{MaxVacFullPPNmmmffffffhhtygghGGGGyuhggghghffggfgf} we
approximate
\er{MaxVacFull1bjkgjhjhgjgjgkjfhjfdghghligioiuittrPPNggjoiuouighujjhmngkhjkhjjhjh}
as:
\begin{equation}\label{MaxVacFull1bjkgjhjhgjgjgkjfhjfdghghligioiuittrPPNggjoiuouighujjhmngkhjkhjjhjhkhkjh}
\begin{cases}
\vec {\tilde B}\approx curl_{\vec x} \vec {\tilde A^*},\\
\vec {\tilde E}\approx-\nabla_{\vec
x}\tilde\Psi^*_1-\frac{1}{c}\left(\frac{\partial\vec {\tilde
A^*}}{\partial t}-\,curl_{\vec x}\left(\vec {u}\times \vec {\tilde
A^*}\right)+\left(div_{\vec x}\vec {\tilde A^*}\right)\vec
{u}\right).
\end{cases}
\end{equation}
Furthermore, inserting
\er{MaxVacFull1bjkgjhjhgjgjgkjfhjfdghghligioiuittrPPNggjoiuouighujjhmngkh}
into
\er{MaxVacFullPPNffGGvbccgcu8uuuuukjjkhhhihjhjkhhkmgjhjhhkhjhhjjhjhjjh}
and using \er{slowaetherGGaagghgh}, \er{slochangGGaajhhjhj} and
\er{MaxVacFullPPNmmmffffffhhtygghGGGGyuhggghghffggfgf} we deduce
\begin{equation}\label{MaxVacFullPPNffGGvbccgcu8uuuuukjjkhhhihjhjkhhkmgjhjhhkhjhhjjhjhjjhjkkhhhhj}
\begin{cases}
curl_{\vec x} \left(\kappa_0\vec{\tilde{
B}}\right)\approx\frac{4\pi}{c}\left(\vec {\tilde
j^*}-\tilde\rho^*\vec u\right)- \frac{1}{c}\left(\frac{\partial
}{\partial t}\left(\frac{1}{\gamma_0}\nabla_{\vec
x}\tilde\Psi^*_1\right)-\,curl_{\vec x}\left(\vec
{u}\times\frac{1}{\gamma_0} \nabla_{\vec
x}\tilde\Psi^*_1\right)+\left(div_{\vec x}\left\{\frac{1}{\gamma_0}
\nabla_{\vec
x}\tilde\Psi^*_1\right\}\right)\vec u\right),\\
div_{\vec x} \left(\frac{1}{\gamma_0}\vec{\tilde{ E}}\right)\approx
4\pi\tilde\rho^*,\\
\frac{\partial\tilde\rho^*}{\partial t}+div_{\vec x}\vec {\tilde
j^*}\approx 0,\\
\vec {\tilde E}\approx-\nabla_{\vec
x}\tilde\Psi^*_1-\frac{1}{c}\left(\frac{\partial\vec {\tilde
A^*}}{\partial t}-\,curl_{\vec x}\left(\vec {u}\times \vec {\tilde
A^*}\right)+\left(div_{\vec x}\vec {\tilde A^*}\right)\vec
{u}\right),
\\ \vec {\tilde B}\approx curl_{\vec x} \vec {\tilde
A^*}.
\end{cases}
\end{equation}
Then again all equations in the new system
\er{MaxVacFullPPNffGGvbccgcu8uuuuukjjkhhhihjhjkhhkmgjhjhhkhjhhjjhjhjjhjkkhhhhj}
are invariant under the change of inertial or non-inertial cartesian
coordinate system.

Furthermore, if we assume that in the domain of the study the
coefficients $\gamma_0\neq 0$ and $\kappa_0\neq 0$ vary sufficiently
slow on the place and time and thus their spatial and temporal
derivatives are negligible with respect to other terms, then again
by \er{vfyutuyfffhhgfhgfh} and \er{vhfffngghhjghhgjlkhjhkPPP} in
Proposition \ref{yghgjtgyrtrt} and by
\er{MaxVacFullPPNmmmffffffhhtygghGGGGyuhggghghffggfgf},
\er{MaxVacFullPPNffGGvbccgcu8uuuuukjjkhhhihjhjkhhkmgjhjhhkhjhhjjhjhjjhjkkhhhhj}
can be approximately rewritten as:
\begin{equation}\label{MaxVacFullPPNffGGvbccgcu8uuuuukjjkhhhihjhjkhhkmgjhjhhkhjhhjjhjhjjh22}
\begin{cases}
\kappa_0\,curl_{\vec x} \vec{\tilde{
B}}\approx\frac{4\pi}{c}\left(\vec {\tilde j^*}-\tilde\rho^*\vec
u\right)- \nabla_{\vec
x}\left\{\frac{1}{c}\frac{1}{\gamma_0}\left(\frac{\partial\tilde\Psi^*_1}{\partial
t}+\vec
{u}\cdot\nabla_{\vec x}\tilde\Psi^*_1\right)\right\},\\
div_{\vec x} \vec{\tilde{ E}}\approx
4\pi\gamma_0\tilde\rho^*,\\
\frac{\partial\tilde\rho^*}{\partial t}+div_{\vec x}\vec {\tilde
j^*}\approx 0,\\
\vec {\tilde E}\approx-\nabla_{\vec
x}\tilde\Psi^*_1-\frac{1}{c}\left(\frac{\partial\vec {\tilde
A^*}}{\partial t}-\,curl_{\vec x}\left(\vec {u}\times \vec {\tilde
A^*}\right)+\left(div_{\vec x}\vec {\tilde A^*}\right)\vec
{u}\right),\\ \vec {\tilde B}\approx curl_{\vec x} \vec {\tilde
A^*}.
\end{cases}
\end{equation}
Furthermore, inserting the last equation of
\er{MaxVacFullPPNffGGvbccgcu8uuuuukjjkhhhihjhjkhhkmgjhjhhkhjhhjjhjhjjh22}
into the first one and using \er{apfrm12}, we rewrite
\er{MaxVacFullPPNffGGvbccgcu8uuuuukjjkhhhihjhjkhhkmgjhjhhkhjhhjjhjhjjh22}
as:
\begin{equation}\label{MaxVacFullPPNffGGvbccgcu8uuuuukjjkhhhihjhjkhhkmgjhjhhkhjhhjjhjhjjhjjjk}
\begin{cases}
-\Delta_{\vec x}\vec {\tilde
A^*}\approx\frac{4\pi}{\kappa_0c}\left(\vec {\tilde
j^*}-\tilde\rho^*\vec u\right)- \nabla_{\vec x}\left\{div_{\vec
x}\vec {\tilde
A^*}+\frac{1}{c}\frac{1}{\gamma_0\kappa_0}\left(\frac{\partial\tilde\Psi^*_1}{\partial
t}+\vec
{u}\cdot\nabla_{\vec x}\tilde\Psi^*_1\right)\right\},\\
div_{\vec x} \vec{\tilde{ E}}\approx
4\pi\gamma_0\tilde\rho^*,\\
\frac{\partial\tilde\rho^*}{\partial t}+div_{\vec x}\vec {\tilde
j^*}\approx 0,\\
\vec {\tilde E}\approx-\nabla_{\vec
x}\tilde\Psi^*_1-\frac{1}{c}\left(\frac{\partial\vec {\tilde
A^*}}{\partial t}-\,curl_{\vec x}\left(\vec {u}\times \vec {\tilde
A^*}\right)+\left(div_{\vec x}\vec {\tilde A^*}\right)\vec
{u}\right),\\
\vec {\tilde B}\approx curl_{\vec x} \vec {\tilde A^*}.
\end{cases}
\end{equation}
On the other hand, taking the divergence of both sides of the second
equation in
\er{MaxVacFull1bjkgjhjhgjgjgkjfhjfdghghligioiuittrPPNggjoiuouighujjhmngkhjkhjjhjhkhkjh}
gives
\begin{equation}\label{MaxVacFull1bjkgjhjhgjgjgkjfhjfdghghligioiuittrPPNggjoiuouighujjhmngkhjkhjjhjhkhkjhjhjhjg}
div_{\vec x}\vec {\tilde E}+\Delta_{\vec
x}\tilde\Psi_1\approx-\frac{1}{c}\left(\frac{\partial}{\partial
t}\left(div_{\vec x}\vec {\tilde A^*}\right)+div_{\vec
x}\left\{\left(div_{\vec x}\vec {\tilde A^*}\right)\vec
{u}\right\}\right).
\end{equation}
Inserting
\er{MaxVacFull1bjkgjhjhgjgjgkjfhjfdghghligioiuittrPPNggjoiuouighujjhmngkhjkhjjhjhkhkjhjhjhjg}
into
\er{MaxVacFullPPNffGGvbccgcu8uuuuukjjkhhhihjhjkhhkmgjhjhhkhjhhjjhjhjjhjjjk}
gives
\begin{equation}\label{MaxVacFullPPNffGGvbccgcu8uuuuukjjkhhhihjhjkhhkmgjhjhhkhjhhjjhjhjjhjjjkghhhf33}
\begin{cases}
-\Delta_{\vec x}\vec {\tilde
A^*}\approx\frac{4\pi}{\kappa_0c}\left(\vec {\tilde
j^*}-\tilde\rho^*\vec u\right)- \nabla_{\vec x}\left\{div_{\vec
x}\vec {\tilde
A^*}+\frac{1}{c}\frac{1}{\gamma_0\kappa_0}\left(\frac{\partial\tilde\Psi^*_1}{\partial
t}+\vec
{u}\cdot\nabla_{\vec x}\tilde\Psi^*_1\right)\right\},\\
-\Delta_{\vec x}\tilde\Psi_1\approx
4\pi\gamma_0\tilde\rho^*+\frac{1}{c}\left(\frac{\partial}{\partial
t}\left(div_{\vec x}\vec {\tilde A^*}\right)+div_{\vec
x}\left\{\left(div_{\vec x}\vec {\tilde A^*}\right)\vec
{u}\right\}\right),\\
\frac{\partial\tilde\rho^*}{\partial t}+div_{\vec x}\vec {\tilde
j^*}\approx 0,\\
\vec {\tilde E}\approx-\nabla_{\vec
x}\tilde\Psi^*_1-\frac{1}{c}\left(\frac{\partial\vec {\tilde
A^*}}{\partial t}-\,curl_{\vec x}\left(\vec {u}\times \vec {\tilde
A^*}\right)+\left(div_{\vec x}\vec {\tilde A^*}\right)\vec
{u}\right),\\
\vec {\tilde B}\approx curl_{\vec x} \vec {\tilde A^*}.
\end{cases}
\end{equation}
Therefore, by inserting
\er{MaxVacFull1bjkgjhjhgjgjgkjfhjfdghghligioiuittrPPNggjoiuouighujjhmngkh}
into
\er{MaxVacFullPPNffGGvbccgcu8uuuuukjjkhhhihjhjkhhkmgjhjhhkhjhhjjhjhjjhjjjkghhhf33}
we deduce:
\begin{equation}\label{MaxVacFullPPNffGGvbccgcu8uuuuukjjkhhhihjhjkhhkmgjhjhhkhjhhjjhjhjjhjjjkghhhf}
\begin{cases}
-\Delta_{\vec x}\vec {\tilde
A^*}\approx\frac{4\pi}{\kappa_0c}\left(\vec {\tilde
j^*}-\tilde\rho^*\vec u\right)- \nabla_{\vec x}\left\{div_{\vec
x}\vec {\tilde
A^*}+\frac{1}{c}\frac{1}{\gamma_0\kappa_0}\left(\frac{\partial\tilde\Psi^*_1}{\partial
t}+\vec
{u}\cdot\nabla_{\vec x}\tilde\Psi^*_1\right)\right\},\\
-\Delta_{\vec x}\tilde\Psi_1\approx
4\pi\gamma_0\tilde\rho^*+\frac{1}{c}\left(\frac{\partial}{\partial
t}\left(div_{\vec x}\vec {\tilde A^*}\right)+div_{\vec
x}\left\{\left(div_{\vec x}\vec {\tilde A^*}\right)\vec
{u}\right\}\right),\\
\frac{\partial\tilde\rho^*}{\partial t}+div_{\vec x}\vec {\tilde
j^*}\approx 0,\\
\vec {\tilde E}\approx -\nabla_{\vec
x}\tilde\Psi^*_1-\frac{1}{c}\left(\frac{\partial\vec {\tilde
A^*}}{\partial t}-\vec {u}\times \left(curl_{\vec x} \vec {\tilde
A^*}\right)+\nabla_{\vec x}\left(\vec {\tilde A^*}\cdot\vec
{u}\right)\right),\\
\vec {\tilde B}\approx curl_{\vec x} \vec {\tilde A^*}.
\end{cases}
\end{equation}
Thus, in the case of calibration
\er{MaxVacFull1bjkgjhjhgjgjgkjfhjfdghghligioiuittrPPN22jnhhjhjh} by
\er{vfyutuyfffhhgfhgfh} and \er{vhfffngghhjghhgjlkhjhkPPP} in
Proposition \ref{yghgjtgyrtrt}, using
\er{MaxVacFullPPNmmmffffffhhtygghGGGGyuhggghghffggfgf},
\er{MaxVacFullPPNffGGvbccgcu8uuuuukjjkhhhihjhjkhhkmgjhjhhkhjhhjjhjhjjhjjjkghhhf}
can be rewritten as:
\begin{equation}\label{MaxVacFullPPNffGGvbccgcu8uuuuukjjkhhhihjhjkhhkmgjhjhhkhjhhjjhjhjjhjjjkghhhfhjijhjhkhhkzz}
\begin{cases}
-\Delta_{\vec x}\tilde\Psi_1\approx
4\pi\gamma_0\tilde\rho^*,\\
-\Delta_{\vec x}\vec {\tilde
A^*}\approx\frac{4\pi}{\kappa_0c}\left(\vec {\tilde
j^*}-\tilde\rho^*\vec u\right)
-\frac{1}{c}\frac{1}{\gamma_0\kappa_0}\left(\frac{\partial
}{\partial t}\left(\nabla_{\vec x}\tilde\Psi^*_1\right)-\,curl_{\vec
x}\left(\vec {u}\times\nabla_{\vec
x}\tilde\Psi^*_1\right)+\left(\Delta_{\vec
x}\tilde\Psi^*_1\right)\vec u\right)
,\\
\frac{\partial\tilde\rho^*}{\partial t}+div_{\vec x}\vec {\tilde
j^*}\approx 0,\\
\vec {\tilde E}\approx -\nabla_{\vec
x}\tilde\Psi^*_1-\frac{1}{c}\left(\frac{\partial\vec {\tilde
A^*}}{\partial t}-\vec {u}\times \left(curl_{\vec x} \vec {\tilde
A^*}\right)+\nabla_{\vec x}\left(\vec {\tilde A^*}\cdot\vec
{u}\right)\right),
\\
\vec {\tilde B}\approx curl_{\vec x} \vec {\tilde A^*}.
\end{cases}
\end{equation}
On the other hand, under the alternative calibration
\er{MaxVacFullPPNjjjjffhhGGljkjjkkjhk}, again using
\er{slowaetherGGaagghgh}, \er{slochangGGaajhhjhj} and
\er{MaxVacFullPPNmmmffffffhhtygghGGGGyuhggghghffggfgf} in the second
equation of
\er{MaxVacFullPPNffGGvbccgcu8uuuuukjjkhhhihjhjkhhkmgjhjhhkhjhhjjhjhjjhjjjkghhhf},
we approximately rewrite
\er{MaxVacFullPPNffGGvbccgcu8uuuuukjjkhhhihjhjkhhkmgjhjhhkhjhhjjhjhjjhjjjkghhhf}
as:
\begin{equation}\label{MaxVacFullPPNffGGvbccgcu8uuuuukjjkhhhihjhjkhhkmgjhjhhkhjhhjjhjhjjhjjjkghhhfjkjkkhjggjzz}
\begin{cases}
-\Delta_{\vec x}\vec {\tilde
A^*}\approx\frac{4\pi}{\kappa_0c}\left(\vec {\tilde
j^*}-\tilde\rho^*\vec u\right),\\
-\Delta_{\vec x}\tilde\Psi_1\approx
4\pi\gamma_0\tilde\rho^*,\\
\frac{\partial\tilde\rho^*}{\partial t}+div_{\vec x}\vec {\tilde
j^*}\approx 0,\\
\vec {\tilde E}\approx -\nabla_{\vec
x}\tilde\Psi^*_1-\frac{1}{c}\left(\frac{\partial\vec {\tilde
A^*}}{\partial t}-\vec {u}\times \left(curl_{\vec x} \vec {\tilde
A^*}\right)+\nabla_{\vec x}\left(\vec {\tilde A^*}\cdot\vec
{u}\right)\right),
\\
\vec {\tilde B}\approx curl_{\vec x} \vec {\tilde A^*}.
\end{cases}
\end{equation}
Furthermore, defining the conduction currents:
\begin{equation}\label{uiyuyyyyyyyyyyyyyyyyyyyyyyyyyyyyyyyyyyyyyyyyyyyyyyyklj}
\begin{cases}
\vec {j}_{cond}:=\vec {j}-\rho\vec u,\\
\vec {\tilde j^*}_{cond}:=\vec {\tilde j^*}-\tilde\rho^*\vec u,
\end{cases}
\end{equation}
we obviously have:
\begin{equation}\label{uiyuyyyyyyyyyyyyyyyyyyyyyyyyyyyyyyyyyyyyyyyyyyyyyyykljggngn}
\begin{cases}
\vec {j}_{cond}= \vec {\tilde j^*}_{cond},\\
\frac{\partial\rho}{\partial t}=div_{\vec x}\left\{\rho\vec
u\right\}=-div_{\vec x}\left\{\vec {j}_{cond}\right\},
\\
\frac{\partial\tilde\rho^*}{\partial t}+div_{\vec
x}\left\{\tilde\rho^*\vec u\right\}\approx-div_{\vec x}\left\{\vec
{\tilde j^*}_{cond}\right\},
\end{cases}
\end{equation}
and moreover, obviously $\vec {j}_{cond}=\vec {\tilde j^*}_{cond}$
is a \underline{proper} vector field. Thus, by
\er{uiyuyyyyyyyyyyyyyyyyyyyyyyyyyyyyyyyyyyyyyyyyyyyyyyyklj} and
\er{uiyuyyyyyyyyyyyyyyyyyyyyyyyyyyyyyyyyyyyyyyyyyyyyyyykljggngn} we
rewrite
\er{MaxVacFullPPNffGGvbccgcu8uuuuukjjkhhhihjhjkhhkmgjhjhhkhjhhjjhjhjjhjjjkghhhfhjijhjhkhhkzz}
as:
\begin{equation}\label{MaxVacFullPPNffGGvbccgcu8uuuuukjjkhhhihjhjkhhkmgjhjhhkhjhhjjhjhjjhjjjkghhhfhjijhjhkhhk}
\begin{cases}
-\Delta_{\vec x}\tilde\Psi_1\approx
4\pi\gamma_0\tilde\rho^*,\\
-\Delta_{\vec x}\vec {\tilde A^*}\approx\frac{4\pi}{\kappa_0c}\vec
{\tilde j^*}_{cond}
-\frac{1}{c}\frac{1}{\gamma_0\kappa_0}\left(\frac{\partial
}{\partial t}\left(\nabla_{\vec x}\tilde\Psi^*_1\right)-\,curl_{\vec
x}\left(\vec {u}\times\nabla_{\vec
x}\tilde\Psi^*_1\right)+\left(\Delta_{\vec
x}\tilde\Psi^*_1\right)\vec u\right)
,\\
\frac{\partial\tilde\rho^*}{\partial t}+div_{\vec
x}\left\{\tilde\rho^*\vec u\right\}\approx-div_{\vec x}\left\{\vec
{\tilde j^*}_{cond}\right\},\\
\vec {\tilde E}\approx -\nabla_{\vec
x}\tilde\Psi^*_1-\frac{1}{c}\left(\frac{\partial\vec {\tilde
A^*}}{\partial t}-\vec {u}\times \left(curl_{\vec x} \vec {\tilde
A^*}\right)+\nabla_{\vec x}\left(\vec {\tilde A^*}\cdot\vec
{u}\right)\right),
\\
\vec {\tilde B}\approx curl_{\vec x} \vec {\tilde A^*}.
\end{cases}
\end{equation}
On the other hand, by
\er{uiyuyyyyyyyyyyyyyyyyyyyyyyyyyyyyyyyyyyyyyyyyyyyyyyyklj} and
\er{uiyuyyyyyyyyyyyyyyyyyyyyyyyyyyyyyyyyyyyyyyyyyyyyyyykljggngn} we
rewrite
\er{MaxVacFullPPNffGGvbccgcu8uuuuukjjkhhhihjhjkhhkmgjhjhhkhjhhjjhjhjjhjjjkghhhfjkjkkhjggjzz},
which was obtained under the alternative calibration, as:
\begin{equation}\label{MaxVacFullPPNffGGvbccgcu8uuuuukjjkhhhihjhjkhhkmgjhjhhkhjhhjjhjhjjhjjjkghhhfjkjkkhjggj}
\begin{cases}
-\Delta_{\vec x}\vec {\tilde A^*}\approx\frac{4\pi}{\kappa_0c}\vec
{\tilde j^*}_{cond},\\
-\Delta_{\vec x}\tilde\Psi_1\approx
4\pi\gamma_0\tilde\rho^*,\\
\frac{\partial\tilde\rho^*}{\partial t}+div_{\vec
x}\left\{\tilde\rho^*\vec u\right\}\approx-div_{\vec x}\left\{\vec
{\tilde j^*}_{cond}\right\},\\
\vec {\tilde E}\approx -\nabla_{\vec
x}\tilde\Psi^*_1-\frac{1}{c}\left(\frac{\partial\vec {\tilde
A^*}}{\partial t}-\vec {u}\times \left(curl_{\vec x} \vec {\tilde
A^*}\right)+\nabla_{\vec x}\left(\vec {\tilde A^*}\cdot\vec
{u}\right)\right),
\\
\vec {\tilde B}\approx curl_{\vec x} \vec {\tilde A^*}.
\end{cases}
\end{equation}
In both given cases of calibration in
\er{MaxVacFullPPNffGGvbccgcu8uuuuukjjkhhhihjhjkhhkmgjhjhhkhjhhjjhjhjjhjjjkghhhfhjijhjhkhhk}
or
\er{MaxVacFullPPNffGGvbccgcu8uuuuukjjkhhhihjhjkhhkmgjhjhhkhjhhjjhjhjjhjjjkghhhfjkjkkhjggj}
we just need to solve two Laplace equations in order to find
$\tilde\Psi_1$ and $\vec {\tilde A^*}$ and then we find $\vec
{\tilde E}$ and $\vec {\tilde B}$ by differentiation. Moreover, in
 either the case of
\er{MaxVacFullPPNffGGvbccgcu8uuuuukjjkhhhihjhjkhhkmgjhjhhkhjhhjjhjhjjhjjjkghhhfhjijhjhkhhk}
or the case of
\er{MaxVacFullPPNffGGvbccgcu8uuuuukjjkhhhihjhjkhhkmgjhjhhkhjhhjjhjhjjhjjjkghhhfjkjkkhjggj}
we have
\begin{equation}\label{MaxVagjkgggggggggggggjhjg}
div_{\vec x}\left\{\vec {\tilde E}+\nabla_{\vec
x}\tilde\Psi^*_1\right\}\approx 0\,
\end{equation}
and the Ohm's Law in the form of \er{vjhfhjtjhjhuyyiyGGhhgjg}:
\begin{equation}\label{vjhfhjtjhjhuyyiyGGhhgjgkhhhhhj}
\vec {\tilde j^*}_{cond}=\varepsilon\vec{\tilde E}.
\end{equation}
Furthermore, the Joules heat term, has the following simple form
\begin{equation}\label{vjhiiiihhjjhjhkkk}
\vec {\tilde j^*}_{cond}\cdot\vec{\tilde E}\,.
\end{equation}
Moreover, in both cases of
\er{MaxVacFullPPNffGGvbccgcu8uuuuukjjkhhhihjhjkhhkmgjhjhhkhjhhjjhjhjjhjjjkghhhfhjijhjhkhhk}
or
\er{MaxVacFullPPNffGGvbccgcu8uuuuukjjkhhhihjhjkhhkmgjhjhhkhjhhjjhjhjjhjjjkghhhfjkjkkhjggj},
by the last two equations there we obviously have:
\begin{equation}\label{gjkgjkgjkgjkgjkgjkgjkgjkgjkgjkgjkgjkgjkgjkgjkgjkgjkgjkgjkgjkgjkhjjh}
\begin{cases}
curl_{\vec x} \vec{\tilde{
E}}\approx-\left(\frac{1}{c}\frac{\partial \vec{\tilde{
B}}}{\partial t}-\frac{1}{c}\,curl_{\vec x}\left(\vec {u}\times
\vec{\tilde{
B}}\right)\right),\\
div_{\vec x} \vec{\tilde{ B}}\approx 0.
\end{cases}
\end{equation}

Next, in order to find the real electromagnetic fields in the usual
notation we have
\er{MaxVacFullPPNffGGvbccgcu8uuuuukjjkhhhihjhjkhhkmgjhjhhkhjhhjjhjhjjh11},
i.e. the following:
\begin{equation}\label{MaxVacFullPPNffGGvbccgcu8uuuuukjjkhhhihjhjkhhkmgjhjhhkhjhhjjhjhjjh11ljkjkljklj}
\begin{cases}
\vec B\approx\vec{\tilde{ B}}+\frac{1}{c}\,\left(\vec {u}-\vec
v\right)\times \vec{\tilde{ E}},\\
\vec E=\vec{\tilde{ E}}-\frac{1}{c}\,\vec {u}\times \vec{ B},\\
\rho=\tilde\rho^*+\frac{1}{c^2}\left(\vec {u}-\vec v\right)\cdot\vec
{\tilde j^*}_{cond},\\
\vec {j}_{cond}= \vec {\tilde j^*}_{cond},\\
\vec j=\vec {\tilde j^*}_{cond}+\rho\vec u.
\end{cases}
\end{equation}
Finally, as before, we deduce that all equations in either
\er{MaxVacFullPPNffGGvbccgcu8uuuuukjjkhhhihjhjkhhkmgjhjhhkhjhhjjhjhjjhjjjkghhhfhjijhjhkhhk}
or
\er{MaxVacFullPPNffGGvbccgcu8uuuuukjjkhhhihjhjkhhkmgjhjhhkhjhhjjhjhjjhjjjkghhhfjkjkkhjggj},
and, in addition, \er{MaxVagjkgggggggggggggjhjg},
\er{vjhfhjtjhjhuyyiyGGhhgjgkhhhhhj},
\er{gjkgjkgjkgjkgjkgjkgjkgjkgjkgjkgjkgjkgjkgjkgjkgjkgjkgjkgjkgjkgjkhjjh}
and
\er{MaxVacFullPPNffGGvbccgcu8uuuuukjjkhhhihjhjkhhkmgjhjhhkhjhhjjhjhjjh11ljkjkljklj}
are invariant under the change of inertial or non-inertial cartesian
coordinate system. Therefore, if in coordinate system $(*)$ we can
use the approximate equations, given by either
\er{MaxVacFullPPNffGGvbccgcu8uuuuukjjkhhhihjhjkhhkmgjhjhhkhjhhjjhjhjjhjjjkghhhfhjijhjhkhhk}
or
\er{MaxVacFullPPNffGGvbccgcu8uuuuukjjkhhhihjhjkhhkmgjhjhhkhjhhjjhjhjjhjjjkghhhfjkjkkhjggj},
and \er{MaxVagjkgggggggggggggjhjg},
\er{vjhfhjtjhjhuyyiyGGhhgjgkhhhhhj},
\er{gjkgjkgjkgjkgjkgjkgjkgjkgjkgjkgjkgjkgjkgjkgjkgjkgjkgjkgjkgjkgjkhjjh}
and
\er{MaxVacFullPPNffGGvbccgcu8uuuuukjjkhhhihjhjkhhkmgjhjhhkhjhhjjhjhjjh11ljkjkljklj},
then we can use the similar approximation
also in coordinate system $(**)$, even in the case when in system
$(**)$ \er{slowaetherGGaagghgh}, \er{slochangGGaajhhjhj} or
\er{MaxVacFullPPNmmmffffffhhtygghGGGGyuhggghghffggfgf} are not
satisfied.

Next, note that either
\er{MaxVacFullPPNffGGvbccgcu8uuuuukjjkhhhihjhjkhhkmgjhjhhkhjhhjjhjhjjhjjjkghhhfhjijhjhkhhk}
or
\er{MaxVacFullPPNffGGvbccgcu8uuuuukjjkhhhihjhjkhhkmgjhjhhkhjhhjjhjhjjhjjjkghhhfjkjkkhjggj},
and \er{MaxVagjkgggggggggggggjhjg},
\er{vjhfhjtjhjhuyyiyGGhhgjgkhhhhhj} and
\er{gjkgjkgjkgjkgjkgjkgjkgjkgjkgjkgjkgjkgjkgjkgjkgjkgjkgjkgjkgjkgjkhjjh}
are completely independent of the vectorial gravitational potential
$\vec v$. Moreover, note that the first two equations in
\er{MaxVacFullPPNffGGvbccgcu8uuuuukjjkhhhihjhjkhhkmgjhjhhkhjhhjjhjhjjhjjjkghhhfjkjkkhjggj}
are completely independent also on the velocity field of the medium
$\vec u$. Furthermore, if $\Omega:=\Omega(t)\subset\mathbb{R}^3$ is
a three-dimensional domain, moving with velocity field $\vec u$
together with the given medium, then, by the third equation of
either
\er{MaxVacFullPPNffGGvbccgcu8uuuuukjjkhhhihjhjkhhkmgjhjhhkhjhhjjhjhjjhjjjkghhhfhjijhjhkhhk}
or
\er{MaxVacFullPPNffGGvbccgcu8uuuuukjjkhhhihjhjkhhkmgjhjhhkhjhhjjhjhjjhjjjkghhhfjkjkkhjggj},
using part {{\bf (iii)}} of Proposition
\ref{jgjjjjjjjjjjjjjjjjjjjjjjjjjjjjjjjjjkhhkh} and the Divergence
Integral Theorem of the Calculus we deduce:
\begin{multline}\label{hjgghffhhffgfgfghkhjhjggjkhkhbnv}
\frac{d}{dt}\left(\iiint\tilde\rho^*\,
d\Omega(t)\right)=\iiint\left(\frac{\partial \tilde\rho^*}{\partial
t}+{div}_{\vec x}\left\{\tilde\rho^* \vec u\right\}\right)\,
d\Omega(t)\approx\\-\iiint\left({div}_{\vec x}\left\{\vec {\tilde
j^*}_{cond}\right\}\right)\, d\Omega(t)=-\iint\vec {\tilde
j^*}_{cond}\,\cdot\vec n\, d\left(\partial\Omega(t)\right),
\end{multline}
where $\partial\Omega(t)$ is the two dimensional boundary of
$\Omega(t)$, oriented by outer unit normal $\vec n$. Moreover,
analogously, by the second equation in
\er{uiyuyyyyyyyyyyyyyyyyyyyyyyyyyyyyyyyyyyyyyyyyyyyyyyykljggngn} we
deduce:
\begin{multline}\label{hjgghffhhffgfgfghkhjhjggjkhkhjhgjgj}
\frac{d}{dt}\left(\iiint\rho\,
d\Omega(t)\right)=\iiint\left(\frac{\partial \rho}{\partial
t}+{div}_{\vec x}\left\{\rho \vec u\right\}\right)\,
d\Omega(t)=\\-\iiint\left({div}_{\vec x}\left\{\vec
{j}_{cond}\right\}\right)\, d\Omega(t)=-\iint\vec
{j}_{cond}\,\cdot\vec n\, d\left(\partial\Omega(t)\right).
\end{multline}
So, we have
\begin{equation}\label{hjgghffhhffgfgfghkhjhjggjkhkhbnv44jhggggggggggghkjh}
\frac{d}{dt}\left(\iiint\rho\, d\Omega(t)\right)=-\iint\vec
{j}_{cond}\,\cdot\vec n\, d\left(\partial\Omega(t)\right)=-\iint\vec
{\tilde j^*}_{cond}\,\cdot\vec n\,
d\left(\partial\Omega(t)\right)\approx
\frac{d}{dt}\left(\iiint\tilde\rho^*\, d\Omega(t)\right).
\end{equation}
Thus, denoting
\begin{equation}\label{khjjhjhgjgjggj}
\begin{cases}
Q_{\Omega(t)}(t):=\iiint\rho\, d\Omega(t),\quad \tilde Q^*_{\Omega(t)}(t):=\iiint\tilde\rho^*\, d\Omega(t)\quad\text{and}\quad\\
I_{\partial\Omega(t)}(t):=\iint\vec {j}_{cond}\,\cdot\vec n
=\iint\vec {\tilde j^*}_{cond}\,\cdot\vec n\,
d\left(\partial\Omega(t)\right):=\tilde I^*_{\partial\Omega(t)}(t),
\end{cases}
\end{equation}
which are the total charge inside the domain $\Omega(t)$, the total
charge$^*$ inside the domain $\Omega(t)$ and the corresponding total
current outward the boundary of $\Omega(t)$, by
\er{hjgghffhhffgfgfghkhjhjggjkhkhbnv44jhggggggggggghkjh} we have
\begin{equation}\label{hjgghffhhffgfgfghkhjhjggjkhkhbnv44jhggggggggggghkjhkjhjhjhjhjh}
\frac{d}{dt}\left(Q_{\Omega(t)}(t)\right)=-I_{\partial\Omega(t)}(t)=-\tilde
I^*_{\partial\Omega(t)}(t)\approx \frac{d}{dt}\left(\tilde
Q^*_{\Omega(t)}(t)\right).
\end{equation}
On the other hand, if $\gamma:=\gamma(t)\subset\mathbb{R}^3$ is a
one-dimensional curve oriented by the unit tangent vector $\vec
t:=\vec t(\vec x,t)$, having the starting and the ending points
$\vec r_{begin}(t), \vec r_{end}(t)$ and moving with velocity field
$\vec u$ together with the given medium, then, by the fourth
equation in either
\er{MaxVacFullPPNffGGvbccgcu8uuuuukjjkhhhihjhjkhhkmgjhjhhkhjhhjjhjhjjhjjjkghhhfhjijhjhkhhk}
or
\er{MaxVacFullPPNffGGvbccgcu8uuuuukjjkhhhihjhjkhhkmgjhjhhkhjhhjjhjhjjhjjjkghhhfjkjkkhjggj}
using part {{\bf (ii)}} of Proposition
\ref{jgjjjjjjjjjjjjjjjjjjjjjjjjjjjjjjjjjkhhkh} we obtain:
\begin{multline*}
\int\vec {\tilde E}\cdot\vec t\, d\gamma(t)\approx-\int\nabla_{\vec
x}\tilde\Psi^*_1\cdot\vec t\,
d\gamma(t)-\frac{1}{c}\int\left(\frac{\partial \vec {\tilde
A^*}}{\partial t}- \vec u\times curl_{\vec x}\vec {\tilde
A^*}+\nabla_{\vec x}\left(\vec u\cdot\vec {\tilde
A^*}\right)\right)\cdot\vec t\,
d\gamma(t)\\=\tilde\Psi^*_1\left(\vec
r_{begin}(t),t\right)-\tilde\Psi^*_1\left(\vec
r_{end}(t),t\right)-\frac{1}{c}\frac{d}{dt}\left(\int\vec {\tilde
A^*}\cdot\vec t\, d\gamma(t)\right),
\end{multline*}
and thus,
\begin{equation}\label{MaxVacFullPPNffGGvbccgcu8uuuuukjjkhhhihjhjkhhkmgjhjhhkhjhhjjhjhjjh11ljkjkljkljjjkjkljkjk11jhgggjh11gjgghhjhhh}
\int\vec {\tilde E}\cdot\vec t\,
d\gamma(t)\approx\tilde\Psi^*_1\left(\vec
r_{begin}(t),t\right)-\tilde\Psi^*_1\left(\vec
r_{end}(t),t\right)-\frac{1}{c}\frac{d}{dt}\left(\int\vec {\tilde
A^*}\cdot\vec t\, d\gamma(t)\right).
\end{equation}
In particular, in the case where $\gamma(t):=\partial\mathcal{S}(t)$
is a boundary of a two-dimensional surface
$\mathcal{S}:=\mathcal{S}(t)\subset\mathbb{R}^3$, oriented by the
unit normal $\vec n:=\vec n(\vec x,t)$, since in the later case we
have $\vec r_{begin}(t)=\vec r_{end}(t)$, using the last equation in
either
\er{MaxVacFullPPNffGGvbccgcu8uuuuukjjkhhhihjhjkhhkmgjhjhhkhjhhjjhjhjjhjjjkghhhfhjijhjhkhhk}
or
\er{MaxVacFullPPNffGGvbccgcu8uuuuukjjkhhhihjhjkhhkmgjhjhhkhjhhjjhjhjjhjjjkghhhfjkjkkhjggj}
and the Stokes Theorem, we rewrite
\er{MaxVacFullPPNffGGvbccgcu8uuuuukjjkhhhihjhjkhhkmgjhjhhkhjhhjjhjhjjh11ljkjkljkljjjkjkljkjk11jhgggjh11gjgghhjhhh}
as
\begin{equation}\label{MaxVacFullPPNffGGvbccgcu8uuuuukjjkhhhihjhjkhhkmgjhjhhkhjhhjjhjhjjh11ljkjkljkljjjkjkljkjk11jhgggjh11gjgghhjhhhhjjhjghjjhhjhjh}
\int\vec {\tilde E}\cdot\vec t\,
d\gamma(t)\approx-\frac{1}{c}\frac{d}{dt}\left(\iint\vec {\tilde
B}\cdot\vec n\,
d\mathcal{S}(t)\right)=-\frac{1}{c}\frac{d}{dt}\left(\iint
curl_{\vec x}\vec {\tilde A^*}\cdot\vec n\, d\mathcal{S}(t)\right).
\end{equation}
Finally, the obtained relations in \er{MaxVagjkgggggggggggggjhjg},
\er{vjhfhjtjhjhuyyiyGGhhgjgkhhhhhj}, \er{vjhiiiihhjjhjhkkk},
\er{hjgghffhhffgfgfghkhjhjggjkhkhbnv44jhggggggggggghkjhkjhjhjhjhjh}
with \er{khjjhjhgjgjggj},
\er{MaxVacFullPPNffGGvbccgcu8uuuuukjjkhhhihjhjkhhkmgjhjhhkhjhhjjhjhjjh11ljkjkljkljjjkjkljkjk11jhgggjh11gjgghhjhhh},
\er{MaxVacFullPPNffGGvbccgcu8uuuuukjjkhhhihjhjkhhkmgjhjhhkhjhhjjhjhjjh11ljkjkljkljjjkjkljkjk11jhgggjh11gjgghhjhhhhjjhjghjjhhjhjh}
and the first two equations in
\er{MaxVacFullPPNffGGvbccgcu8uuuuukjjkhhhihjhjkhhkmgjhjhhkhjhhjjhjhjjhjjjkghhhfjkjkkhjggj}
are completely independent of the velocity field $\vec u$ ( and on
the vectorial gravitational potential $\vec v$). However, the
mentioned relations in \er{MaxVagjkgggggggggggggjhjg},
\er{vjhfhjtjhjhuyyiyGGhhgjgkhhhhhj},
\er{hjgghffhhffgfgfghkhjhjggjkhkhbnv44jhggggggggggghkjhkjhjhjhjhjh}
with \er{khjjhjhgjgjggj},
\er{MaxVacFullPPNffGGvbccgcu8uuuuukjjkhhhihjhjkhhkmgjhjhhkhjhhjjhjhjjh11ljkjkljkljjjkjkljkjk11jhgggjh11gjgghhjhhh},
\er{MaxVacFullPPNffGGvbccgcu8uuuuukjjkhhhihjhjkhhkmgjhjhhkhjhhjjhjhjjh11ljkjkljkljjjkjkljkjk11jhgggjh11gjgghhjhhhhjjhjghjjhhjhjh}
and the first two equations in
\er{MaxVacFullPPNffGGvbccgcu8uuuuukjjkhhhihjhjkhhkmgjhjhhkhjhhjjhjhjjhjjjkghhhfjkjkkhjggj}
are together sufficient for estimation of DC or AC linear chains
and, in particular, they sufficient to obtain the Kirchhoff's
current and voltage laws, Faraday's law of induction and the law of
capacity. Thus, the estimation of DC or quasistationary AC currents
in the case of moving electrical chains and/or in the case of
non-trivial gravitation is completely analogous to that estimation
for the \underline{resting} chains without influence of any
gravitation!

\subsection{Geometric optics inside a moving medium and/or in the presence of gravitational
field}\label{GO}
\subsubsection{Preliminary calculations}
Assume that in some inertial or non-inertial cartesian coordinate
system a \underline{real valued} scalar field $U:=U(\vec x,t)$,
characterizing some wave, satisfies the following wave equation
\begin{equation}\label{MaxVacFullPPNmmmffffffiuiuhjuughbghhjj}
\frac{\partial}{\partial
t}\left\{\frac{1}{c^2_0}\left(\frac{\partial U}{\partial t}+\vec
{\tilde u}\cdot\nabla_{\vec x}U\right)\right\}+\Div_{\vec x}
\left\{\frac{1}{c^2_0}\left(\frac{\partial U}{\partial t}+\vec
{\tilde u}\cdot\nabla_{\vec x} U\right)\vec {\tilde
u}\right\}-\Delta_{\vec x}U=0,
\end{equation}
where $\vec {\tilde u}:=\vec {\tilde u}(\vec x,t)$ is some
moderately changing (in space and in time) speed-like vector field
and $c_0:=c_0(\vec x,t)>0$ is a moderately changing (in space and in
time) scalar quantity, that we call wave propagation speed. Note
that \er{MaxVacFullPPNmmmffffffiuiuhjuughbghhjj} coincides with
\er{MaxVacFullPPNmmmffffffiuiuhjuGGggFGelGHGHGHGG} and thus, in
particular, $U$ can represent one of any scalar components of the
electromagnetic field in the medium without dispersion, i.e. when
\er{EBDHTrans444knbuihuigGG} is valid. In this case $\vec{\tilde u}$
is the linear combination of the velocity of the medium $\vec u$ and
the vectorial gravitational potential $\vec v$. Note also that
\er{MaxVacFull1ninshtrgravortghhghgjkgghklhjgkghghjjkjhjkkggjkhjkhjjhhfhjhkjhjjhkhbbgjlmkmmhzzzyyykkknnnNWBWHWPPNjjlllkkkkkkljkkjjkjkhjjhhgghgggghjjhgjggh}
or
\er{MaxVacFull1ninshtrgravortghhghgjkgghklhjgkghghjjkjhjkkggjkhjkhjjhhfhjhkjhjjhkhbbgjlmkmmhzzzyyykkknnnNWBWHWPPNjjlllkkkkkkljkkjjkjkhjjhhgghgggghjjhgjgghjkjkjk}
also coincide with \er{MaxVacFullPPNmmmffffffiuiuhjuughbghhjj} and
thus, in particular, $U$ could represent the oscillating part of the
pressure $p_1$ in the sound wave. However, in the later case
$\vec{\tilde u}$ is equal to the averaged (macroscopical) velocity
of the fluid/gas $\vec u_0$. Moreover,
\er{MaxVacFull1ninshtrgravortghhghgjkgghklhjgkghghjjkjhjkkggjkhjkhjjhhfhjhkjkhbbgjhzzzyyykkknnnNWBWHWPPNjjjljkjkkhhhhhhhhhhhhfnfffhuyuyuyutjhjhjhggjjfgfhfhjhhgjgjhkjhjhkhkhjj}
and \er{MaxVacFullPPNmmmffffffiuiuhjuGGggFGjjkkjjffjj} also coincide
with \er{MaxVacFullPPNmmmffffffiuiuhjuughbghhjj} and thus, in
particular, $U$ could represent either longitudinal or transverse
wave in an elastic body. In the later case $\vec{\tilde u}$ is also
equal to the averaged (macroscopical) velocity of the elastic medium
$\vec u_0$.

Next if we assume that the fields $\vec {\tilde u}$ and $c_0$ are
independent of the time variable, since  $U$ is a real valued field,
then we can write the field $U$ as a Fourier's Transform on the time
variable:
\begin{equation}\label{MaxVacFullPPNmmmffffffhhtygghGGGGyuhggghghghffgggg}
U(\vec x,t)=Re\left\{2\int_{0}^{+\infty} \hat U(\vec
x,\omega)e^{i\omega t}d\omega\right\}\quad\text{where}\quad\hat
U(\vec x,\omega):=\frac{1}{2\pi}\int_{-\infty}^{+\infty} U(\vec
x,t)e^{-i\omega t}dt\,.
\end{equation}
Moreover, by \er{MaxVacFullPPNmmmffffffiuiuhjuughbghhjj} we obtain
that the Fourier's Transform $\hat U(\vec x,\omega)$ satisfies:
\begin{equation}\label{MaxVacFullPPNmmmffffffiuiuhjuughbghhjjugg}
i\omega\frac{1}{c^2_0}\left(i\omega \hat U+\vec {\tilde
u}\cdot\nabla_{\vec x}\hat U\right)+\Div_{\vec x}
\left\{\frac{1}{c^2_0}\left(i\omega \hat U+\vec {\tilde
u}\cdot\nabla_{\vec x} \hat U\right)\vec {\tilde
u}\right\}-\Delta_{\vec x}\hat U=0.
\end{equation}
Thus by \er{MaxVacFullPPNmmmffffffiuiuhjuughbghhjjugg}, for every
given $\omega$ the monochromatic wave type function
\begin{equation}\label{MaxVacFullPPNmmmffffffhhtygghGGGGyuhggghghghffgggguiu}
U_\omega(\vec x,t):=\hat U(\vec x,\omega)e^{i\omega t}
\end{equation}
is a complex solution of
\begin{equation}\label{MaxVacFullPPNmmmffffffiuiuhjuughbghhjjghghh}
\frac{\partial}{\partial
t}\left\{\frac{1}{c^2_0}\left(\frac{\partial U_\omega}{\partial
t}+\vec {\tilde u}\cdot\nabla_{\vec
x}U_\omega\right)\right\}+\Div_{\vec x}
\left\{\frac{1}{c^2_0}\left(\frac{\partial U_\omega}{\partial
t}+\vec {\tilde u}\cdot\nabla_{\vec x} U_\omega\right)\vec {\tilde
u}\right\}-\Delta_{\vec x}U_\omega=0.
\end{equation}
Note that equation \er{MaxVacFullPPNmmmffffffiuiuhjuughbghhjjghghh}
coincides with \er{MaxVacFullPPNmmmffffffiuiuhjuughbghhjj}.
Moreover, by \er{MaxVacFullPPNmmmffffffhhtygghGGGGyuhggghghghffgggg}
a general real solution of
\er{MaxVacFullPPNmmmffffffiuiuhjuughbghhjj} can be represented as
the real part of a superposition of monochromatic waves of type
$U_\omega=f(\vec x,\omega)e^{i\omega t}$ that satisfy
\er{MaxVacFullPPNmmmffffffiuiuhjuughbghhjjghghh} for every $\omega$.
Finally if we consider a complex valued function $\mathcal{U}(\vec
x,t)$, defined by
\begin{equation}\label{MaxVacFullPPNmmmffffffhhtygghGGGGyuhggghghghffggggjkjjkjk}
\mathcal{U}(\vec x,t):=2\int_{0}^{+\infty} \hat U(\vec
x,\omega)e^{i\omega t}d\omega,
\end{equation}
then $Re\,\mathcal{U}=U$ and $\mathcal{U}$ is a complex solution of
\er{MaxVacFullPPNmmmffffffiuiuhjuughbghhjj} (i.e. not only the real
part of $\mathcal{U}$ but also the imaginary part solve
\er{MaxVacFullPPNmmmffffffiuiuhjuughbghhjj}).

Next assume that a scalar \underline{complex} field $U:=U(\vec x,t)$
satisfies \er{MaxVacFullPPNmmmffffffiuiuhjuughbghhjj}. In
particular, $U$ can be a monochromatic solution of
\er{MaxVacFullPPNmmmffffffiuiuhjuughbghhjjghghh}. Furthermore, we
represent the complex field $U$ as:
\begin{equation}\label{MaxVacFullPPNmmmffffffiuiuhjuughbghhuiiujjjjjjjj}
U(\vec x,t)=A(\vec x,t)e^{iT(\vec x,t)},
\end{equation}
where $A:=A(\vec x,t)$ is a complex scalar field and $T:=T(\vec
x,t)$ is a real scalar field. Then define
\begin{equation}\label{MaxVacFullPPNmmmffffffiuiuhjuughbghhuiiujjjjjjjjhhhjjj}
\omega:=\left<\,\left|\frac{\partial T}{\partial t}\right|\,\right>,
\end{equation}
where the sign $\left<\cdot\right>$ means the spatial and temporal
averaging. Next define $k_0$ and a scalar field $S:=S(\vec x,t)$ by
\begin{equation}\label{MaxVacFullPPNmmmffffffiuiuhjuughbghhuiiujjjjjjjjhhhjjjkk}
k_0:=\frac{\omega}{c}\quad\quad\text{and}\quad\quad S(\vec
x,t)=\frac{1}{k_0}T(\vec x,t),
\end{equation}
where $c$ is a constant in the Maxwell equations for the vacuum. So
we clearly have
\begin{equation}\label{MaxVacFullPPNmmmffffffiuiuhjuughbghhuiiujjjjjjjjhhhjjjkuuiuijjk}
\left<\,\left|\frac{\partial S}{\partial t}\right|\,\right>=c\,.
\end{equation}
Moreover,
\begin{equation}\label{MaxVacFullPPNmmmffffffiuiuhjuughbghhuiiujj}
U(\vec x,t)=A(\vec x,t)e^{ik_0S(\vec x,t)}\,.
\end{equation}

Then we in position to insert it into the wave equation
\er{MaxVacFullPPNmmmffffffiuiuhjuughbghhjj} of the form
\begin{equation}\label{MaxVacFullPPNmmmffffffiuiuhjuughbghhjjjoojoui}
\left(\frac{\partial}{\partial
t}\left\{\frac{1}{c^2_0}\left(\frac{\partial U}{\partial t}+\vec
{\tilde u}\cdot\nabla_{\vec x}U\right)\right\}+\Div_{\vec x}
\left\{\frac{1}{c^2_0}\left(\frac{\partial U}{\partial t}+\vec
{\tilde u}\cdot\nabla_{\vec x} U\right)\vec {\tilde
u}\right\}\right)-\Delta_{\vec x}U=0.
\end{equation}
Thus, since by \er{MaxVacFullPPNmmmffffffiuiuhjuughbghhuiiujj} we
have
\begin{equation}\label{MaxVacFullPPNmmmffffffiuiuhjuughbghhuiiujjjj}
\begin{cases}
\nabla_{\vec x}U=e^{ik_0S}\left(\nabla_{\vec x}A+ik_0A\nabla_{\vec x}S\right)\\
\frac{\partial U}{\partial t}+\vec {\tilde u}\cdot\nabla_{\vec
x}U=e^{ik_0S}\left(\left(\frac{\partial A}{\partial t}+\vec {\tilde
u}\cdot\nabla_{\vec x}A\right)+ik_0A\left(\frac{\partial S}{\partial
t}+\vec {\tilde u}\cdot\nabla_{\vec x}S\right)\right)\,,
\end{cases}
\end{equation}
and furthermore
\begin{multline}\label{MaxVacFullPPNmmmffffffiuiuhjuughbghhuiiujjjj3}
\Delta_{\vec x}U=e^{ik_0S}\left(\Delta_{\vec x}A+ik_0\nabla_{\vec x}A\cdot\nabla_{\vec x}S+ik_0A\Delta_{\vec x}S\right)+ik_0e^{ik_0S}\nabla_{\vec x}S\cdot\left(\nabla_{\vec x}A+ik_0A\nabla_{\vec x}S\right)\\
=e^{ik_0S}\left(\Delta_{\vec x}A+2ik_0\nabla_{\vec
x}A\cdot\nabla_{\vec x}S+ik_0A\Delta_{\vec
x}S-k^2_0A\left|\nabla_{\vec x}S\right|^2\right)\,,
\end{multline}
and
\begin{multline}\label{MaxVacFullPPNmmmffffffiuiuhjuughbghhuiiujjjj5}
\frac{\partial}{\partial
t}\left\{\frac{1}{c^2_0}\left(\frac{\partial U}{\partial t}+\vec
{\tilde u}\cdot\nabla_{\vec x}U\right)\right\}+\Div_{\vec x}
\left\{\frac{1}{c^2_0}\left(\frac{\partial U}{\partial t}+\vec
{\tilde u}\cdot\nabla_{\vec x} U\right)\vec {\tilde
u}\right\}\\=e^{ik_0S}\left(\frac{\partial}{\partial
t}\left\{\frac{1}{c^2_0}\left(\frac{\partial A}{\partial t}+\vec
{\tilde u}\cdot\nabla_{\vec x}A\right)\right\}+\Div_{\vec x}
\left\{\frac{1}{c^2_0}\left(\frac{\partial A}{\partial t}+\vec
{\tilde u}\cdot\nabla_{\vec x} A\right)\vec {\tilde
u}\right\}\right)\\+ik_0e^{ik_0S}\frac{1}{c^2_0}\left(\frac{\partial
A}{\partial t}+\vec {\tilde u}\cdot\nabla_{\vec
x}A\right)\left(\frac{\partial S}{\partial t}+\vec {\tilde
u}\cdot\nabla_{\vec
x}S\right)\\+ik_0e^{ik_0S}A\left(\frac{\partial}{\partial
t}\left\{\frac{1}{c^2_0}\left(\frac{\partial S}{\partial t}+\vec
{\tilde u}\cdot\nabla_{\vec x}S\right)\right\}+\Div_{\vec x}
\left\{\frac{1}{c^2_0}\left(\frac{\partial S}{\partial t}+\vec
{\tilde u}\cdot\nabla_{\vec x} S\right)\vec {\tilde
u}\right\}\right)\\+ik_0e^{ik_0S}\frac{1}{c^2_0}\left(\frac{\partial
S}{\partial t}+\vec {\tilde u}\cdot\nabla_{\vec
x}S\right)\left(\left(\frac{\partial A}{\partial t}+\vec {\tilde
u}\cdot\nabla_{\vec x}A\right)+ik_0A\left(\frac{\partial S}{\partial
t}+\vec {\tilde u}\cdot\nabla_{\vec x}S\right)\right)
\\=e^{ik_0S}\left(\frac{\partial}{\partial
t}\left\{\frac{1}{c^2_0}\left(\frac{\partial A}{\partial t}+\vec
{\tilde u}\cdot\nabla_{\vec x}A\right)\right\}+\Div_{\vec x}
\left\{\frac{1}{c^2_0}\left(\frac{\partial A}{\partial t}+\vec
{\tilde u}\cdot\nabla_{\vec x} A\right)\vec {\tilde
u}\right\}\right)\\+2ik_0\frac{1}{c^2_0}e^{ik_0S}\left(\frac{\partial
A}{\partial t}+\vec {\tilde u}\cdot\nabla_{\vec
x}A\right)\left(\frac{\partial S}{\partial t}+\vec {\tilde
u}\cdot\nabla_{\vec
x}S\right)\\+ik_0e^{ik_0S}A\left(\frac{\partial}{\partial
t}\left\{\frac{1}{c^2_0}\left(\frac{\partial S}{\partial t}+\vec
{\tilde u}\cdot\nabla_{\vec x}S\right)\right\}+\Div_{\vec x}
\left\{\frac{1}{c^2_0}\left(\frac{\partial S}{\partial t}+\vec
{\tilde u}\cdot\nabla_{\vec x} S\right)\vec {\tilde
u}\right\}\right)\\-k^2_0\frac{1}{c^2_0}e^{ik_0S}A\left(\frac{\partial
S}{\partial t}+\vec {\tilde u}\cdot\nabla_{\vec x}S\right)^2,
\end{multline}
inserting \er{MaxVacFullPPNmmmffffffiuiuhjuughbghhuiiujjjj3} and
\er{MaxVacFullPPNmmmffffffiuiuhjuughbghhuiiujjjj5} into
\er{MaxVacFullPPNmmmffffffiuiuhjuughbghhjjjoojoui} we deduce:
\begin{multline}\label{MaxVacFullPPNmmmffffffiuiuhjuughbghhiuijghghghhjhjhjjuiuiui}
k^2_0\left(\left|\nabla_{\vec
x}S\right|^2-\frac{1}{c^2_0}\left(\frac{\partial S}{\partial t}+\vec
{\tilde u}\cdot\nabla_\vec x
S\right)^2\right)A\\+\left(\frac{\partial}{\partial
t}\left\{\frac{1}{c^2_0}\left(\frac{\partial A}{\partial t}+\vec
{\tilde u}\cdot\nabla_{\vec x}A\right)\right\}+\Div_{\vec x}
\left\{\frac{1}{c^2_0}\left(\frac{\partial A}{\partial t}+\vec
{\tilde u}\cdot\nabla_{\vec x} A\right)\vec {\tilde
u}\right\}\right)-\Delta_{\vec
x}A\\
+i\kappa_0 A\left(\frac{\partial}{\partial
t}\left\{\frac{1}{c^2_0}\left(\frac{\partial S}{\partial t}+\vec
{\tilde u}\cdot\nabla_{\vec x}S\right)\right\}+\Div_{\vec x}
\left\{\frac{1}{c^2_0}\left(\frac{\partial S}{\partial t}+\vec
{\tilde u}\cdot\nabla_{\vec x} S\right)\vec {\tilde
u}\right\}-\Delta_{\vec
x}S\right)\\+2i\kappa_0\left(\frac{1}{c^2_0}\left(\frac{\partial
S}{\partial t}+\vec {\tilde u}\cdot\nabla_{\vec
x}S\right)\left(\frac{\partial A}{\partial t}+\vec {\tilde
u}\cdot\nabla_{\vec x}A\right)-\nabla_{\vec x}A\cdot\nabla_{\vec
x}S\right)=0.
\end{multline}
Finally, denoting
\begin{equation}\label{MaxVacFullPPNmmmffffffiuiuhjuughbghhiuijghghghhhgjgfghjdfjgddg1133khhujuuuo}
\tilde\omega(\vec x,t):=\kappa_0\left(\frac{\partial S}{\partial
t}\left(\vec x,t\right)+\vec {\tilde u}\left(\vec
x,t\right)\cdot\nabla_{\vec x}S\left(\vec x,t\right)\right),
\end{equation}
\begin{equation}\label{MaxVacFullPPNmmmffffffiuiuhjuughbghhiuijghghghhhgjgfghjdfjgddg113355}
\vec k(\vec x,t):=c_0(\vec x,t)\left(\frac{\partial S}{\partial
t}(\vec x,t)+\vec {\tilde u}(\vec x,t)\cdot\nabla_{\vec x}S(\vec
x,t)\right)^{-1}\nabla_{\vec x}S(\vec x,t),
\end{equation}
\begin{equation}\label{MaxVacFullPPNmmmffffffiuiuhjuughbghhiuijghghghhhgjgfghjdfjgddg1133}
\vec h(\vec x,t):=\vec {\tilde u}(\vec x,t)-c^2_0(\vec
x,t)\left(\frac{\partial S}{\partial t}(\vec x,t)+\vec {\tilde
u}(\vec x,t)\cdot\nabla_{\vec x}S(\vec x,t)\right)^{-1}\nabla_{\vec
x}S(\vec x,t)=\vec {\tilde u}(\vec x,t)-c_0(\vec x,t)\vec k(\vec
x,t),
\end{equation}
and
\begin{multline}\label{MaxVacFullPPNmmmffffffiuiuhjuughbghhiuijghghghhhfhhghghguygtjuuujjjkhjhjh11kklkloiouu33}
G\left(\vec x,t\right):=\\
\left(\frac{c^2_0}{2\left(\frac{\partial S}{\partial t}+\vec {\tilde
u}\cdot\nabla_{\vec x}S\right)}\left(\Delta_{\vec
x}S-\frac{\partial}{\partial
t}\left\{\frac{1}{c^2_0}\left(\frac{\partial S}{\partial t}+\vec
{\tilde u}\cdot\nabla_{\vec x}S\right)\right\}-\Div_{\vec x}
\left\{\frac{1}{c^2_0}\left(\frac{\partial S}{\partial t}+\vec
{\tilde u}\cdot\nabla_{\vec x} S\right)\vec {\tilde
u}\right\}\right)\right)(\vec x,t)
\\=
\left(\frac{c^2_0}{2\tilde\omega}\left(\Div_{\vec
x}\left\{\frac{\tilde\omega}{c_0}\,\vec
k\right\}-\frac{\partial}{\partial
t}\left\{\frac{\tilde\omega}{c^2_0}\right\}-\Div_{\vec x}
\left\{\frac{\tilde\omega}{c^2_0}\,\vec {\tilde
u}\right\}\right)\right)(\vec x,t)\\=-
\left(\frac{c^2_0}{2\tilde\omega}\left(\frac{\partial}{\partial
t}\left\{\frac{\tilde\omega}{c^2_0}\right\}+\vec h\cdot\nabla_{\vec
x} \left\{\frac{\tilde\omega}{c^2_0}\,\right\}\right)\right)(\vec
x,t)-\frac{1}{2}\left(\Div_{\vec x}\vec h(\vec x,t)\right),
\end{multline}
by inserting
\er{MaxVacFullPPNmmmffffffiuiuhjuughbghhiuijghghghhhgjgfghjdfjgddg1133khhujuuuo},
\er{MaxVacFullPPNmmmffffffiuiuhjuughbghhiuijghghghhhgjgfghjdfjgddg113355},
\er{MaxVacFullPPNmmmffffffiuiuhjuughbghhiuijghghghhhgjgfghjdfjgddg1133}
and
\er{MaxVacFullPPNmmmffffffiuiuhjuughbghhiuijghghghhhfhhghghguygtjuuujjjkhjhjh11kklkloiouu33}
into
\er{MaxVacFullPPNmmmffffffiuiuhjuughbghhiuijghghghhjhjhjjuiuiui} we
clearly have:
\begin{multline}\label{MaxVacFullPPNmmmffffffiuiuhjuughbghhiuijghghghhhfhhghghguygtjjjkkjjkkhhklklkjk33hhjj}
\frac{\tilde\omega}{2}\left(|\vec
k|^2-1\right)A+\frac{c^2_0}{2\tilde\omega}\left\{\frac{\partial}{\partial
t}\left\{\frac{1}{c^2_0}\left(\frac{\partial A}{\partial t}+\vec
{\tilde u}\cdot\nabla_{\vec x}A\right)\right\}+\Div_{\vec x}
\left\{\frac{1}{c^2_0}\left(\frac{\partial A}{\partial t}+\vec
{\tilde u}\cdot\nabla_{\vec x} A\right)\vec {\tilde
u}\right\}-\Delta_{\vec x}A\right\}\\+i\left(\frac{\partial
A}{\partial t}
+\vec h\cdot\nabla_{\vec x}A
-G\, A\right)=0
.
\end{multline}

In particular if we assume for the moment that
\begin{equation}\label{MaxVacFullPPNmmmffffffiuiuhjuughbghhiuijghghghhjhjhhghyuyjjjhhjhjffiiik}
\frac{1}{c^2_0}\left(\frac{\partial S}{\partial t}+\vec {\tilde
u}\cdot\nabla_\vec x S\right)^2=\left|\nabla_\vec x S\right|^2.
\end{equation}
i.e.
\begin{equation}\label{yuyyuyuyu}
|\vec k|^2=1\,,
\end{equation}
then we rewrite
\er{MaxVacFullPPNmmmffffffiuiuhjuughbghhiuijghghghhhfhhghghguygtjjjkkjjkkhhklklkjk33hhjj}
as:
\begin{multline}\label{MaxVacFullPPNmmmffffffiuiuhjuughbghhiuijghghghhhfhhghghguygtjjjkkjjkkhhklklkjk33hhjj111}
\frac{c^2_0}{2\tilde\omega}\left\{\frac{\partial}{\partial
t}\left\{\frac{1}{c^2_0}\left(\frac{\partial A}{\partial t}+\vec
{\tilde u}\cdot\nabla_{\vec x}A\right)\right\}+\Div_{\vec x}
\left\{\frac{1}{c^2_0}\left(\frac{\partial A}{\partial t}+\vec
{\tilde u}\cdot\nabla_{\vec x} A\right)\vec {\tilde
u}\right\}-\Delta_{\vec x}A\right\}\\+i\left(\frac{\partial
A}{\partial t}
+\vec h\cdot\nabla_{\vec x}A
-G\, A\right)=0
.
\end{multline}
Moreover, in the latter case we obviously have
\begin{equation}\label{MaxVacFullPPNmmmffffffiuiuhjuughbghhiuijghghghhhgjgfghjdfjgddg1133khhujuuuo99klklljljljhjhjhyf}
\frac{\tilde\omega^2}{c^2_0k_0^2}=\frac{1}{c^2_0}\left(\frac{\partial
S}{\partial t}+\tilde{\vec u}\cdot\nabla_{\vec
x}S\right)^2=\left|\nabla_{\vec x}S\right|^2, \quad\quad\left|\vec
k\right|^2=1\quad\quad\text{or equivalently}\quad\quad \left|\vec
h-\vec {\tilde u}\right|^2=c^2_0\,,
\end{equation}
\begin{equation}\label{MaxVacFullPPNmmmffffffiuiuhjuughbghhiuijghghghhhgjgfghjdfjgddg1133khhujuuuo33uuu}
\frac{\partial S}{\partial t}+\vec {\tilde u}\cdot\nabla_{\vec
x}S=c_0\vec k\cdot\nabla_{\vec x}S\quad\quad \text{or
equivalently}\quad\quad\quad\quad \frac{\partial S}{\partial t}+\vec
h\cdot\nabla_{\vec x}S=0\,,
\end{equation}
\begin{equation}\label{MaxVacFullPPNmmmffffffiuiuhjuughbghhiuijghghghhhgjgfghjdfjgddg1133khhujuuuo99klklljljljhjhjhyfjkkjkhkiili}
\nabla_{\vec x}S\,=\,\left(\vec k(\vec x,t)\cdot\nabla_{\vec
x}S\right)\,\vec k\,,
\end{equation}
\begin{equation}\label{MaxVacFullPPNmmmffffffiuiuhjuughbghhiuijghghghhhgjgfghjdfjgddg11335533jkjjkujjj}
\frac{c_0}{\tilde\omega}\vec k=\frac{1}{k_0}\left|\nabla_{\vec
x}S\right|^{-2}\nabla_{\vec
x}S=\frac{c^2_0k_0}{\tilde\omega^2}\nabla_{\vec x}S,
\end{equation}
and
\begin{equation}\label{MaxVacFullPPNmmmffffffiuiuhjuughbghhiuijghghghhhgjgfghjdfjgddg1133khhujuuuo33uuuiujkjk}
\frac{1}{k_0}\,\nabla_{\vec x}\tilde\omega=\frac{\partial}{\partial
t}\left\{\nabla_{\vec x}S\right\}+\nabla^2_{\vec x}S\cdot\vec
{\tilde u}+\left\{d_{\vec x}\vec {\tilde
u}\right\}^T\cdot\nabla_{\vec x}S\,.
\end{equation}
In particular,
\begin{equation}\label{MaxVacFullPPNmmmffffffiuiuhjuughbghhiuijghghghhhgjgfghjdfjgddg1133khhujuuuo33uuuiujkjkuyuy}
\frac{2}{k_0}\,\nabla_{\vec x}S\cdot\nabla_{\vec
x}\tilde\omega=\frac{\partial}{\partial t}\left\{\left|\nabla_{\vec
x}S\right|^2\right\}+\vec {\tilde u}\cdot\nabla_{\vec
x}\left\{\left|\nabla_{\vec x}S\right|^2\right\}+\left(\nabla_{\vec
x}S\cdot\left(\left\{d_{\vec x}\vec {\tilde u}\right\}^T+d_{\vec
x}\vec {\tilde u}\right)\right)\cdot\nabla_{\vec x}S\,,
\end{equation}
so that
\begin{equation}\label{MaxVacFullPPNmmmffffffiuiuhjuughbghhiuijghghghhhgjgfghjdfjgddg1133khhujuuuo33uuuiujkjkuyuyiio}
c_0\,\frac{1}{c^2_0}\,\vec k\cdot\nabla_{\vec
x}\left\{\tilde\omega^2\right\}=\frac{\partial}{\partial
t}\left\{\frac{\tilde\omega^2}{c^2_0}\right\}+\vec {\tilde
u}\cdot\nabla_{\vec
x}\left\{\frac{\tilde\omega^2}{c^2_0}\right\}+\frac{\tilde\omega^2}{c^2_0}\left(\vec
k\cdot\left(\left\{d_{\vec x}\vec {\tilde u}\right\}^T+d_{\vec
x}\vec {\tilde u}\right)\right)\cdot\vec k\,.
\end{equation}
I.e.
\begin{equation}\label{MaxVacFullPPNmmmffffffiuiuhjuughbghhiuijghghghhhgjgfghjdfjgddg1133khhujuuuo33uuuiujkjkuyuyiioiiuui}
\left(\frac{\tilde\omega^2}{c^2_0}\right)^{-1}\left(\frac{\partial}{\partial
t}\left\{\frac{\tilde\omega^2}{c^2_0}\right\}+\vec
h\cdot\nabla_{\vec
x}\left\{\frac{\tilde\omega^2}{c^2_0}\right\}\right)=
\frac{1}{c^2_0}\left(c_0\vec k\right)\cdot\nabla_{\vec
x}\left\{c^2_0\right\}-\left(\vec k\cdot\left(\left\{d_{\vec x}\vec
{\tilde u}\right\}^T+d_{\vec x}\vec {\tilde
u}\right)\right)\cdot\vec k\,.
\end{equation}
We rewrite it as:
\begin{equation}\label{MaxVacFullPPNmmmffffffiuiuhjuughbghhiuijghghghhhgjgfghjdfjgddg1133khhujuuuo33uuuiujkjkuyuyiioiiuuiiiiu}
\frac{1}{\tilde\omega}\left(\frac{\partial \tilde\omega}{\partial
t}+\vec h\cdot\nabla_{\vec
x}\tilde\omega\right)=\frac{1}{c_0}\left(\frac{\partial
c_0}{\partial t}+\vec {\tilde u}\cdot\nabla_{\vec
x}c_0\right)-\frac{1}{2}\left(\vec k\cdot\left(\left\{d_{\vec x}\vec
{\tilde u}\right\}^T+d_{\vec x}\vec {\tilde
u}\right)\right)\cdot\vec k\,,
\end{equation}
provided, we have
\er{MaxVacFullPPNmmmffffffiuiuhjuughbghhiuijghghghhjhjhhghyuyjjjhhjhjffiiik}.

\subsubsection{Derivation of the Eikonal equation}\label{ekGO}
In particular, in the case both $S$ and $A$ are real, since the zero
complex number has both real and imaginary part equal to zero, by
\er{MaxVacFullPPNmmmffffffiuiuhjuughbghhiuijghghghhjhjhjjuiuiui} we
have:
\begin{multline}\label{MaxVacFullPPNmmmffffffiuiuhjuughbghhiuijghghghhjhjhjj}
k^2_0\left(\left|\nabla_{\vec
x}S\right|^2-\frac{1}{c^2_0}\left(\frac{\partial S}{\partial t}+\vec
{\tilde u}\cdot\nabla_\vec x
S\right)^2\right)A\\+\frac{\partial}{\partial
t}\left\{\frac{1}{c^2_0}\left(\frac{\partial A}{\partial t}+\vec
{\tilde u}\cdot\nabla_{\vec x}A\right)\right\}+\Div_{\vec x}
\left\{\frac{1}{c^2_0}\left(\frac{\partial A}{\partial t}+\vec
{\tilde u}\cdot\nabla_{\vec x} A\right)\vec {\tilde
u}\right\}-\Delta_{\vec x}A=0,
\end{multline}
and
\begin{multline}\label{MaxVacFullPPNmmmffffffiuiuhjuughbghhiuijghghghhhfhjj}
A\left(\frac{\partial}{\partial
t}\left\{\frac{1}{c^2_0}\left(\frac{\partial S}{\partial t}+\vec
{\tilde u}\cdot\nabla_{\vec x}S\right)\right\}+\Div_{\vec x}
\left\{\frac{1}{c^2_0}\left(\frac{\partial S}{\partial t}+\vec
{\tilde u}\cdot\nabla_{\vec x} S\right)\vec {\tilde
u}\right\}-\Delta_{\vec
x}S\right)\\+2\left(\frac{1}{c^2_0}\left(\frac{\partial S}{\partial
t}+\vec {\tilde u}\cdot\nabla_{\vec x}S\right)\left(\frac{\partial
A}{\partial t}+\vec {\tilde u}\cdot\nabla_{\vec
x}A\right)-\nabla_{\vec x}A\cdot\nabla_{\vec x}S\right)=0.
\end{multline}
Next assume either the delicate or the rough Geometric Optics
approximation that are good for the electromagnetic wave of high
frequency for example for the visible light. The delicate Geometric
Optics approximation means the following: assume that the changes in
time of $c_0$, $\vec {\tilde u}$, $A$ and $S$ become essential after
certain interval of time $T_e$ and the changes in space of $c_0$,
$A$ and $S$ become essential in the spatial landscape $L_e$. Then we
assume that
\begin{equation}\label{slochangGGaaffgfgjhjggh}
k^2_0c^2_0T^2_e\,\gg\, 1\quad\text{and}\quad k^2_0L^2_e\,\gg\, 1.
\end{equation}
On the other hand, the rough Geometric Optics approximation
(stronger than \er{slochangGGaaffgfgjhjggh}) means the following:
\begin{equation}\label{slochangGGaaffgfgjhjggh11}
k_0c_0T_e\,\gg\, 1\quad\text{and}\quad k_0L_e\,\gg\, 1.
\end{equation}
Moreover, we assume that the order of $c_0$ is less or equal to the
order of $c$. In particular, estimation \er{slochangGGaaffgfgjhjggh}
implies
\begin{multline}\label{MaxVacFullPPNmmmffffffhhtygghGGGGyuhggghghghffgiooiiojhjjhhjuiuiioiooikjiij11}
\frac{\left| d_{\vec x}\vec {\tilde u}+\left\{d_{\vec x}\vec {\tilde
u}\right\}^T\right|^2}{
c^2_0}+\frac{\left(| \nabla_{\vec
x}A|+\frac{1}{c_0}\left|\frac{\partial A}{\partial t}+\vec {\tilde
u}\cdot\nabla_{\vec x}A\right|\right)^2}{|A|^2}+\frac{\left(|
\nabla_{\vec x}c_0|+\frac{1}{c_0}\left|\frac{\partial c_0}{\partial
t}+\vec {\tilde u}\cdot\nabla_{\vec
x}c_0\right|\right)^2}{|c_0|^2}\\+\frac{ \left| \nabla^2_{\vec
x}S\right|^2+ \frac{1}{c^2_0}\left|\nabla_{\vec
x}\left(\frac{\partial S}{\partial t}+\vec {\tilde
u}\cdot\nabla_{\vec x}S\right)\right|^2 }{ \left| \nabla_{\vec
x}S\right|^2} \ll k^2_0 \quad\quad\text{and}\quad\quad
\frac{\left|\frac{\partial S}{\partial t}+\vec {\tilde
u}\cdot\nabla_{\vec x}S-c\right|^2}{c^2}\ll 1\,,
\end{multline}

Thus, using \er{slochangGGaaffgfgjhjggh}, we approximate
\er{MaxVacFullPPNmmmffffffiuiuhjuughbghhiuijghghghhjhjhjj} as:
\begin{equation}\label{MaxVacFullPPNmmmffffffiuiuhjuughbghhiuijghghghhjhjhhghyuyjjjhhjhjff}
\frac{1}{c^2_0}\left(\frac{\partial S}{\partial t}+\vec {\tilde
u}\cdot\nabla_\vec x S\right)^2=\left|\nabla_\vec x S\right|^2.
\end{equation}
Moreover, without any use in either \er{slochangGGaaffgfgjhjggh} or
\er{slochangGGaaffgfgjhjggh11} we rewrite
\er{MaxVacFullPPNmmmffffffiuiuhjuughbghhiuijghghghhhfhjj} as:
\begin{multline}\label{MaxVacFullPPNmmmffffffiuiuhjuughbghhiuijghghghhhfhhghghguygtjjjkkjjkkhh}
\left(\frac{\partial}{\partial
t}\left\{\frac{1}{c^2_0}\left(\frac{\partial S}{\partial t}+\vec
{\tilde u}\cdot\nabla_{\vec x}S\right)\right\}+\Div_{\vec x}
\left\{\frac{1}{c^2_0}\left(\frac{\partial S}{\partial t}+\vec
{\tilde u}\cdot\nabla_{\vec x} S\right)\vec {\tilde
u}\right\}-\left(\Delta_{\vec x}S\right)\right)A
\\+\frac{2}{c^2_0}\left(\frac{\partial S}{\partial
t}+\vec {\tilde u}\cdot\nabla_{\vec x}S\right)\left(\frac{\partial
A}{\partial t}+\vec {\tilde u}\cdot\nabla_{\vec x}A\right)
-2\nabla_{\vec x}S\cdot\nabla_{\vec x}A=0\,,
\end{multline}
Equality
\er{MaxVacFullPPNmmmffffffiuiuhjuughbghhiuijghghghhjhjhhghyuyjjjhhjhjff}
is called the Eikonal equation and equality
\er{MaxVacFullPPNmmmffffffiuiuhjuughbghhiuijghghghhhfhhghghguygtjjjkkjjkkhh}
is called the equation of the beam propagation. Then, as before, we
deduce that equation
\er{MaxVacFullPPNmmmffffffiuiuhjuughbghhiuijghghghhjhjhhghyuyjjjhhjhjff}
is invariant under the change of non-inertial cartesian coordinate
system, provided that under such change we have $S'=S$. Moreover,
\er{MaxVacFullPPNmmmffffffiuiuhjuughbghhiuijghghghhhfhhghghguygtjjjkkjjkkhh}
is also invariant under the change of non-inertial cartesian
coordinate system, in the case that under such change we have
$A'=A$, provided that $S'=S$.
%
%
%
%
%
%
Furthermore, note that although we established Eikonal equation
\er{MaxVacFullPPNmmmffffffiuiuhjuughbghhiuijghghghhjhjhhghyuyjjjhhjhjff}
in the delicate Geometric Optics approximation
\er{slochangGGaaffgfgjhjggh}, since by
\er{MaxVacFullPPNmmmffffffiuiuhjuughbghhuiiujj} we have
\begin{equation}\label{MaxVacFullPPNmmmffffffiuiuhjuughbghhuiiujjjjjjjjgjjg}
U(\vec x,t)=A(\vec x,t)e^{ik_0S(\vec x,t)}\,,
\end{equation}
then, considering the approximations of $A$ and $S$ as in
\er{MaxVacFullPPNmmmffffffiuiuhjuughbghhiuijghghghhhfhhghghguygtjjjkkjjkkhh},
\er{MaxVacFullPPNmmmffffffiuiuhjuughbghhiuijghghghhjhjhhghyuyjjjhhjhjff}
and putting them into
\er{MaxVacFullPPNmmmffffffiuiuhjuughbghhuiiujjjjjjjjgjjg} we
actually establish $U(\vec x,t)$ only in the \underline{rough}
Geometric Optics approximation  \er{slochangGGaaffgfgjhjggh11} .

Finally, as before, denoting
\begin{equation}\label{MaxVacFullPPNmmmffffffiuiuhjuughbghhiuijghghghhhgjgfghjdfjgddg1133khhujuuuo99}
\tilde\omega(\vec x,t):=\kappa_0\left(\frac{\partial S}{\partial
t}\left(\vec x,t\right)+\vec {\tilde u}\left(\vec
x,t\right)\cdot\nabla_{\vec x}S\left(\vec x,t\right)\right),
\end{equation}
\begin{equation}\label{MaxVacFullPPNmmmffffffiuiuhjuughbghhiuijghghghhhgjgfghjdfjgddg11335599}
\vec k(\vec x,t):=c_0(\vec x,t)\left(\frac{\partial S}{\partial
t}(\vec x,t)+\vec {\tilde u}(\vec x,t)\cdot\nabla_{\vec x}S(\vec
x,t)\right)^{-1}\nabla_{\vec x}S(\vec x,t),
\end{equation}
\begin{equation}\label{MaxVacFullPPNmmmffffffiuiuhjuughbghhiuijghghghhhgjgfghjdfjgddg113399}
\vec h(\vec x,t):=\vec {\tilde u}(\vec x,t)-c^2_0(\vec
x,t)\left(\frac{\partial S}{\partial t}(\vec x,t)+\vec {\tilde
u}(\vec x,t)\cdot\nabla_{\vec x}S(\vec x,t)\right)^{-1}\nabla_{\vec
x}S(\vec x,t)=\vec {\tilde u}(\vec x,t)-c_0(\vec x,t)\vec k(\vec
x,t),
\end{equation}
and
\begin{multline}\label{MaxVacFullPPNmmmffffffiuiuhjuughbghhiuijghghghhhfhhghghguygtjuuujjjkhjhjh11kklkloiouu3399}
G\left(\vec x,t\right):=\\
\left(\frac{c^2_0}{2\left(\frac{\partial S}{\partial t}+\vec {\tilde
u}\cdot\nabla_{\vec x}S\right)}\left(\Delta_{\vec
x}S-\frac{\partial}{\partial
t}\left\{\frac{1}{c^2_0}\left(\frac{\partial S}{\partial t}+\vec
{\tilde u}\cdot\nabla_{\vec x}S\right)\right\}-\Div_{\vec x}
\left\{\frac{1}{c^2_0}\left(\frac{\partial S}{\partial t}+\vec
{\tilde u}\cdot\nabla_{\vec x} S\right)\vec {\tilde
u}\right\}\right)\right)(\vec x,t)
\\=
\left(\frac{c^2_0}{2\tilde\omega}\left(\Div_{\vec
x}\left\{\frac{\tilde\omega}{c_0}\,\vec
k\right\}-\frac{\partial}{\partial
t}\left\{\frac{\tilde\omega}{c^2_0}\right\}-\Div_{\vec x}
\left\{\frac{\tilde\omega}{c^2_0}\,\vec {\tilde
u}\right\}\right)\right)(\vec x,t)\\=-
\left(\frac{c^2_0}{2\tilde\omega}\left(\frac{\partial}{\partial
t}\left\{\frac{\tilde\omega}{c^2_0}\right\}+\vec h\cdot\nabla_{\vec
x} \left\{\frac{\tilde\omega}{c^2_0}\,\right\}\right)\right)(\vec
x,t)-\frac{1}{2}\left(\Div_{\vec x}\vec h(\vec x,t)\right),
\end{multline}
we rewrite
\er{MaxVacFullPPNmmmffffffiuiuhjuughbghhiuijghghghhhfhhghghguygtjjjkkjjkkhh}
and
\er{MaxVacFullPPNmmmffffffiuiuhjuughbghhiuijghghghhjhjhhghyuyjjjhhjhjff}
as:
\begin{equation}\label{MaxVacFullPPNmmmffffffiuiuhjuughbghhiuijghghghhhfhhghghguygtjjjkkjjkkhhklklkjk33hhjjjlj}
\frac{\partial A}{\partial t}(\vec x,t)
+\vec h(\vec x,t)\cdot\nabla_{\vec x}A(\vec x,t)
-G(\vec x,t) A(\vec x,t)=0\,,
\end{equation}
and
\begin{equation}\label{MaxVacFullPPNmmmffffffiuiuhjuughbghhiuijghghghhhgjgfghjdfjgddg1133khhujuuuo99klklljljlj}
\left|\vec k(\vec x,t)\right|^2=1\quad\quad\text{or
equivalently}\quad\quad \left|\vec h(\vec x,t)-\vec {\tilde u}(\vec
x,t)\right|^2=c^2_0(\vec x,t)\,.
\end{equation}

Moreover, as before, we obviously have
\begin{equation}\label{MaxVacFullPPNmmmffffffiuiuhjuughbghhiuijghghghhhgjgfghjdfjgddg1133khhujuuuo99klklljljljhjhjhyf3}
\frac{\tilde\omega^2}{c^2_0k_0^2}=\frac{1}{c^2_0}\left(\frac{\partial
S}{\partial t}+\tilde{\vec u}\cdot\nabla_{\vec
x}S\right)^2=\left|\nabla_{\vec x}S\right|^2,\quad\quad \left|\vec
k\right|^2=1\quad\quad\text{or equivalently}\quad\quad \left|\vec
h-\vec {\tilde u}\right|^2=c^2_0\,,
\end{equation}
\begin{equation}\label{MaxVacFullPPNmmmffffffiuiuhjuughbghhiuijghghghhhgjgfghjdfjgddg1133khhujuuuo33uuu3}
\frac{\partial S}{\partial t}+\vec {\tilde u}\cdot\nabla_{\vec
x}S=c_0\vec k\cdot\nabla_{\vec x}S\quad\quad \text{or
equivalently}\quad\quad\quad\quad \frac{\partial S}{\partial t}+\vec
h\cdot\nabla_{\vec x}S=0\,,
\end{equation}
\begin{equation}\label{MaxVacFullPPNmmmffffffiuiuhjuughbghhiuijghghghhhgjgfghjdfjgddg1133khhujuuuo99klklljljljhjhjhyfjkkjkhkiili3}
\nabla_{\vec x}S\,=\,\left(\vec k(\vec x,t)\cdot\nabla_{\vec
x}S\right)\,\vec k\,,
\end{equation}
\begin{equation}\label{MaxVacFullPPNmmmffffffiuiuhjuughbghhiuijghghghhhgjgfghjdfjgddg11335533jkjjkujjj3}
\frac{c_0}{\tilde\omega}\vec k=\frac{1}{k_0}\left|\nabla_{\vec
x}S\right|^{-2}\nabla_{\vec
x}S=\frac{c^2_0k_0}{\tilde\omega^2}\nabla_{\vec x}S,
\end{equation}
\begin{equation}\label{MaxVacFullPPNmmmffffffiuiuhjuughbghhiuijghghghhhgjgfghjdfjgddg1133khhujuuuo33uuuiujkjk3}
\frac{1}{k_0}\,\nabla_{\vec x}\tilde\omega=\frac{\partial}{\partial
t}\left\{\nabla_{\vec x}S\right\}+\nabla^2_{\vec x}S\cdot\vec
{\tilde u}+\left\{d_{\vec x}\vec {\tilde
u}\right\}^T\cdot\nabla_{\vec x}S\,,
\end{equation}
and
\begin{equation}\label{MaxVacFullPPNmmmffffffiuiuhjuughbghhiuijghghghhhgjgfghjdfjgddg1133khhujuuuo33uuuiujkjkuyuyiioiiuuiiiiu3}
\frac{1}{\tilde\omega}\left(\frac{\partial \tilde\omega}{\partial
t}+\vec h\cdot\nabla_{\vec
x}\tilde\omega\right)=\frac{1}{c_0}\left(\frac{\partial
c_0}{\partial t}+\vec {\tilde u}\cdot\nabla_{\vec
x}c_0\right)-\frac{1}{2}\left(\vec k\cdot\left(\left\{d_{\vec x}\vec
{\tilde u}\right\}^T+d_{\vec x}\vec {\tilde
u}\right)\right)\cdot\vec k\,.
\end{equation}

Next if we consider an arbitrary characteristic curve $\vec
r(t):[t_0,b]\to\mathbb{R}^3$ of equation
\er{MaxVacFullPPNmmmffffffiuiuhjuughbghhiuijghghghhhfhhghghguygtjjjkkjjkkhhklklkjk33hhjjjlj},
parameterized by the time variable $t$, defined as a solution of
ordinary differential equation:
\begin{equation}\label{MaxVacFullPPNmmmffffffiuiuhjuughbghhiuijghghghhhfhhghghguygtjuuujjjkk11}
\begin{cases}
\frac{d\vec r}{dt}(t)=\vec h\left(\vec r(t),t\right)\\
\vec r(t_0)=\vec x_0,
\end{cases}
\end{equation}
where $\vec h$ was defined in
\er{MaxVacFullPPNmmmffffffiuiuhjuughbghhiuijghghghhhgjgfghjdfjgddg113399},
then, as before, by
\er{MaxVacFullPPNmmmffffffiuiuhjuughbghhiuijghghghhhfhhghghguygtjjjkkjjkkhhklklkjk33hhjjjlj},
\er{MaxVacFullPPNmmmffffffiuiuhjuughbghhiuijghghghhhfhhghghguygtjuuujjjkk11}
and the Chain rule we have:
\begin{equation}\label{MaxVacFullPPNmmmffffffiuiuhjuughbghhiuijghghghhhfhhghghguygtjuuujjjkhjhjh11}
\frac{d}{dt}\left(A\left(\vec r(t),t\right)\right)=\nabla_{\vec
x}A\left(\vec r(t),t\right)\cdot\frac{d\vec r}{dt}(t)+\frac{\partial
A}{\partial t}\left(\vec r(t),t\right)= G\left(\vec
r(t),t\right)A\left(\vec r(t),t\right),
\end{equation}
where $G$ was defined in
\er{MaxVacFullPPNmmmffffffiuiuhjuughbghhiuijghghghhhfhhghghguygtjuuujjjkhjhjh11kklkloiouu3399}.
Then
\er{MaxVacFullPPNmmmffffffiuiuhjuughbghhiuijghghghhhfhhghghguygtjuuujjjkhjhjh11}
implies
\begin{equation}\label{MaxVacFullPPNmmmffffffiuiuhjuughbghhiuijghghghhhfhhghghguygtjuuujjjkhjhjhffgg11}
A\left(\vec r(t),t\right)=A\left(\vec
x_0,t_0\right)e^{\int_{t_0}^{t}G\left(\vec
r(\tau),\tau\right)d\tau}\quad\quad\forall t\in[t_0,b].
\end{equation}
In particular,
\begin{multline}\label{MaxVacFullPPNmmmffffffiuiuhjuughbghhiuijghghghhhfhhghghguygtjuuujjjkhjhjhffggiouuiiu11}
A\left(\vec x_0,t_0\right)=0\;\;\text{implies}\;\; A\left(\vec
r(t),t\right)=0\quad\forall t\in[t_0,b],\\ \quad\text{and}\quad
A\left(\vec x_0,t_0\right)\neq 0\;\;\text{implies}\;\; A\left(\vec
r(t),t\right)\neq 0\quad\forall t\in[t_0,b].
\end{multline}
Therefore, by
\er{MaxVacFullPPNmmmffffffiuiuhjuughbghhiuijghghghhhfhhghghguygtjuuujjjkhjhjhffggiouuiiu11}
we deduce that the curve that satisfies
\er{MaxVacFullPPNmmmffffffiuiuhjuughbghhiuijghghghhhfhhghghguygtjuuujjjkk11}
coincides with the ray of light that passes through the point $\vec
x_0$ at the instant of time $t_0$. So, equality
\er{MaxVacFullPPNmmmffffffiuiuhjuughbghhiuijghghghhhfhhghghguygtjuuujjjkk11}
is the equation of a ray and the vector field $\vec h$ defined for
every $\vec x$ by
\er{MaxVacFullPPNmmmffffffiuiuhjuughbghhiuijghghghhhgjgfghjdfjgddg113399}
is the direction of the propagation of the ray that passes through
point $\vec x$ at the instant of time $t$. On the other hand, as
before, we can easily prove that the vector field defined in every
inertial or non-inertial coordinate system by
\er{MaxVacFullPPNmmmffffffiuiuhjuughbghhiuijghghghhhgjgfghjdfjgddg113399}
is a \underline{speed-like} vector field. Moreover, by
\er{MaxVacFullPPNmmmffffffiuiuhjuughbghhiuijghghghhhgjgfghjdfjgddg1133khhujuuuo99klklljljljhjhjhyf3}
the following implication holds:
\begin{equation}\label{MaxVacFullPPNmmmffffffiuiuhjuughbghhiuijghghghhhfhhghghguygtjuuujjjkhjhjhffggiouuiiu11mljjjkk;k}
\vec {\tilde u}=0\quad\text{implies}\quad|\vec h|=c_0.
\end{equation}
Finally, by chain rule, for the curve that satisfies
\er{MaxVacFullPPNmmmffffffiuiuhjuughbghhiuijghghghhhfhhghghguygtjuuujjjkk11}
we have:
\begin{equation}\label{MaxVacFullPPNmmmffffffiuiuhjuughbghhiuijghghghhhfhhghghguygtjuuujjjkhjhjhffggiouuiiu11mljjjkk;kljjkljljhj}
\frac{d}{dt}\left(S\left(\vec r(t),t\right)\right)=\frac{\partial
S}{\partial t}\left(\vec r(t),t\right)+\nabla_{\vec x}S\left(\vec
r(t),t\right)\cdot\frac{d\vec r}{dt}(t)=\frac{\partial S}{\partial
t}\left(\vec r(t),t\right)+\nabla_{\vec x}S\left(\vec
r(t),t\right)\cdot\vec h\left(\vec r(t),t\right).
\end{equation}
Thus, by
\er{MaxVacFullPPNmmmffffffiuiuhjuughbghhiuijghghghhhgjgfghjdfjgddg1133khhujuuuo33uuu3}
and
\er{MaxVacFullPPNmmmffffffiuiuhjuughbghhiuijghghghhhfhhghghguygtjuuujjjkhjhjhffggiouuiiu11mljjjkk;kljjkljljhj}
we deduce
\begin{equation}\label{MaxVacFullPPNmmmffffffiuiuhjuughbghhiuijghghghhhfhhghghguygtjuuujjjkhjhjhffggiouuiiu11mljjjkk;kljjkljljhjklklklkl}
\frac{d}{dt}\left(S\left(\vec
r(t),t\right)\right)=0\quad\quad\forall t\geq t_0,
\end{equation}
and so
\begin{equation}\label{MaxVacFullPPNmmmffffffiuiuhjuughbghhiuijghghghhhfhhghghguygtjuuujjjkhjhjhffggiouuiiu11mljjjkk;kljjkljljhjklklklkliio}
S\left(\vec r(t),t\right)=S\left(\vec
x_0,t_0\right)\quad\quad\forall t\geq t_0.
\end{equation}
%
%
%
%
%
%
By all these facts, vector field $\vec h(\vec x,t)$ defined by
\er{MaxVacFullPPNmmmffffffiuiuhjuughbghhiuijghghghhhgjgfghjdfjgddg113399}
can be considered as the vector of the velocity (speed) of the wave
at the point $\vec x$ at the instant of time $t$. Moreover, by
\er{MaxVacFullPPNmmmffffffiuiuhjuughbghhiuijghghghhhgjgfghjdfjgddg1133khhujuuuo99klklljljljhjhjhyf3}
we have
\begin{equation}\label{MaxVacFullPPNmmmffffffiuiuhjuughbghhiuijghghghhjhjhhghyuyjjjhhjhjffkklkljljjkhhjkjhghghghl;kl;;k;kkljjkjkll;l;}
\left|\vec h-\vec{\tilde u}\right|^2=c^2_0.
\end{equation}
\begin{remark}\label{hjgjgghjj11nnjj}
In contrast to the proof of
\er{MaxVacFullPPNmmmffffffiuiuhjuughbghhiuijghghghhjhjhhghyuyjjjhhjhjff},
we do not use any of the Geometric Optics approximations
\er{slochangGGaaffgfgjhjggh} or \er{slochangGGaaffgfgjhjggh11} in
the proof of
\er{MaxVacFullPPNmmmffffffiuiuhjuughbghhiuijghghghhhfhhghghguygtjjjkkjjkkhh}
and
\er{MaxVacFullPPNmmmffffffiuiuhjuughbghhiuijghghghhhfhhghghguygtjjjkkjjkkhhklklkjk33hhjjjlj}.
So,
\er{MaxVacFullPPNmmmffffffiuiuhjuughbghhiuijghghghhhfhhghghguygtjuuujjjkhjhjhffggiouuiiu11}
is still valid without assumption of the Geometric Optics
approximations \er{slochangGGaaffgfgjhjggh} or
\er{slochangGGaaffgfgjhjggh11} and thus, the vector field $\vec h$,
defined by
\er{MaxVacFullPPNmmmffffffiuiuhjuughbghhiuijghghghhhgjgfghjdfjgddg113399},
is the direction of the propagation of the ray that passes through
point $\vec x$ also in the general case. However, without
assumptions of the Geometric Optics approximation we cannot derive
\er{MaxVacFullPPNmmmffffffiuiuhjuughbghhiuijghghghhjhjhhghyuyjjjhhjhjff},
\er{MaxVacFullPPNmmmffffffiuiuhjuughbghhiuijghghghhhfhhghghguygtjuuujjjkhjhjhffggiouuiiu11mljjjkk;k},
\er{MaxVacFullPPNmmmffffffiuiuhjuughbghhiuijghghghhhfhhghghguygtjuuujjjkhjhjhffggiouuiiu11mljjjkk;kljjkljljhjklklklkliio}
and
\er{MaxVacFullPPNmmmffffffiuiuhjuughbghhiuijghghghhjhjhhghyuyjjjhhjhjffkklkljljjkhhjkjhghghghl;kl;;k;kkljjkjkll;l;}
anymore.
\end{remark}

\subsubsection{The case of the monochromatic wave}\label{mnGO}
Next, up to the end of this subsection, consider the case of
monochromatic wave of the constant frequency
$\nu=\frac{\omega}{2\pi}$ where the fields $\vec {\tilde u}$ and
$c_0$ are independent of the time variable i.e. the case of
\er{MaxVacFullPPNmmmffffffiuiuhjuughbghhuiiujjjjjjjj} where we have
\begin{equation}\label{MaxVacFullPPNmmmffffffiuiuhjuughbghhiuijghghghhjhjhhghyuyhhjhjihikjjjjkjljkl}
\begin{cases}
\frac{\partial T}{\partial t}=\omega\\
\frac{\partial A}{\partial t}=0\\
\frac{\partial \vec {\tilde u}}{\partial t}=0\\
\frac{\partial c_0}{\partial t}=0.
\end{cases}
\end{equation}
Moreover, we assume the rough Geometric Optics approximation
\er{slochangGGaaffgfgjhjggh11}. We also assume that either our
medium has no dispersion (i.e. in the case of an electromagnetic
wave \er{EBDHTrans444knbuihuigGG} is valid), or the velocity of the
medium is negligible i.e. $\vec u\equiv 0$ (and so in the case of an
electromagnetic wave we have $\vec{\tilde u}\equiv \vec v$) and our
medium is transparent for the given frequency
$\nu=\frac{\omega}{2\pi}$. Then, by
\er{MaxVacFullPPNmmmffffffiuiuhjuughbghhuiiujjjjjjjjhhhjjj} and
\er{MaxVacFullPPNmmmffffffiuiuhjuughbghhuiiujjjjjjjjhhhjjjkk} we
rewrite
\er{MaxVacFullPPNmmmffffffiuiuhjuughbghhiuijghghghhjhjhhghyuyhhjhjihikjjjjkjljkl}
as
\begin{equation}\label{MaxVacFullPPNmmmffffffiuiuhjuughbghhiuijghghghhjhjhhghyuyhhjhjihikjjj}
\begin{cases}
\frac{\partial S}{\partial t}=c\\
\frac{\partial A}{\partial t}=0.
\end{cases}
\end{equation}
Thus $\nabla_{\vec x}S$ is independent of $t$ and moreover, we
rewrite
\er{MaxVacFullPPNmmmffffffiuiuhjuughbghhiuijghghghhjhjhhghyuyjjjhhjhjff}
and
\er{MaxVacFullPPNmmmffffffiuiuhjuughbghhiuijghghghhhfhhghghguygtjjjkkjjkkhh}
as:
\begin{equation}\label{MaxVacFullPPNmmmffffffiuiuhjuughbghhiuijghghghhjhjhhghyuyiyyujj}
\frac{c^2}{c^2_0}\left(1+\frac{1}{c}\vec {\tilde u}\cdot\nabla_\vec
x S\right)^2=\left|\nabla_\vec x S\right|^2,
\end{equation}
and
\begin{equation}\label{MaxVacFullPPNmmmffffffiuiuhjuughbghhiuijghghghhhfhhghghguygtjuuujj}
2\left(\nabla_{\vec x}S-\frac{c}{c_0}\left(1+\frac{1}{c}\left(\vec
{\tilde u}\cdot\nabla_{\vec x}S\right)\right)\frac{\vec {\tilde
u}}{c_0}\right)\cdot\nabla_{\vec x}A=\left(\Div_{\vec
x}\left\{\frac{1}{c^2_0}\left(c+\vec {\tilde u}\cdot\nabla_{\vec
x}S\right)\vec {\tilde u}\right\}-\left(\Delta_{\vec
x}S\right)\right)A.
\end{equation}

In particular, in the case of the region of the space where the
following approximation is valid:
\begin{equation}\label{ojhkk}
\frac{|\vec {\tilde u}|^2}{c^2_0}\ll 1,
\end{equation}
up to order $O\left(\frac{|\vec {\tilde u}|^2}{c^2_0}\right)$, we
rewrite
\er{MaxVacFullPPNmmmffffffiuiuhjuughbghhiuijghghghhjhjhhghyuyiyyujj}
as:
\begin{equation}\label{MaxVacFullPPNmmmffffffiuiuhjuughbghhiuijghghghhjhjhhghyuyiyyujjljk}
\left|\frac{c\vec {\tilde u}}{c^2_0}-\nabla_\vec x
S\right|^2=\frac{c^2}{c^2_0},
\end{equation}
and
\er{MaxVacFullPPNmmmffffffiuiuhjuughbghhiuijghghghhhfhhghghguygtjuuujj}
as:
\begin{equation}\label{MaxVacFullPPNmmmffffffiuiuhjuughbghhiuijghghghhhfhhghghguygtjuuujjjk}
\left(\frac{c\vec {\tilde u}}{c^2_0}-\nabla_{\vec
x}S\right)\cdot\nabla_{\vec x}A+\frac{1}{2}\left(\Div_{\vec
x}\left\{\frac{c}{c^2_0}\vec {\tilde u}\right\}-\Delta_{\vec
x}S\right)A=0.
\end{equation}
The Eikonal equation
\er{MaxVacFullPPNmmmffffffiuiuhjuughbghhiuijghghghhjhjhhghyuyiyyujjljk}
and equation of the beam propagation
\er{MaxVacFullPPNmmmffffffiuiuhjuughbghhiuijghghghhhfhhghghguygtjuuujjjk}
are two basic equations of propagation of monochromatic light in the
Geometric Optics approximation inside a moving medium or/and in the
presence of non-trivial gravitational field, provided that the field
$\vec {\tilde u}$ satisfies \er{ojhkk}.

Next if we consider an arbitrary characteristic curve $\vec
r(s):[a,b]\to\mathbb{R}^3$ of equation
\er{MaxVacFullPPNmmmffffffiuiuhjuughbghhiuijghghghhhfhhghghguygtjuuujjjk}
defined as a solution of ordinary differential equation
\begin{equation}\label{MaxVacFullPPNmmmffffffiuiuhjuughbghhiuijghghghhhfhhghghguygtjuuujjjkk}
\begin{cases}
\frac{d\vec r}{ds}(s)=\frac{c}{c^2_0\left(\vec r(s)\right)}\vec
{\tilde u}\left(\vec
r(s)\right)-\nabla_{\vec x}S\left(\vec r(s)\right)\\
\vec r(a)=\vec x_0,
\end{cases}
\end{equation}
then, as before, by
\er{MaxVacFullPPNmmmffffffiuiuhjuughbghhiuijghghghhhfhhghghguygtjuuujjjk}
and
\er{MaxVacFullPPNmmmffffffiuiuhjuughbghhiuijghghghhhfhhghghguygtjuuujjjkk}
we have
\begin{equation}\label{MaxVacFullPPNmmmffffffiuiuhjuughbghhiuijghghghhhfhhghghguygtjuuujjjkhjhjh}
\frac{d}{ds}\left(A\left(\vec r(s)\right)\right)=\nabla_{\vec
x}A\left(\vec r(s)\right)\cdot\frac{d\vec
r}{ds}(s)=\frac{1}{2}\left(\Delta_{\vec x}S\left(\vec
r(s)\right)-\Div_{\vec x}\left\{\frac{c}{c^2_0\left(\vec
r(s)\right)}\vec {\tilde u}\left(\vec
r(s)\right)\right\}\right)A\left(\vec r(s)\right),
\end{equation}
that implies
\begin{equation}\label{MaxVacFullPPNmmmffffffiuiuhjuughbghhiuijghghghhhfhhghghguygtjuuujjjkhjhjhffgg}
A\left(\vec r(s)\right)=A\left(\vec
x_0\right)e^{\frac{1}{2}\int_a^{s}\left(\Delta_{\vec x}S\left(\vec
r(\tau)\right)-\Div_{\vec x}\left\{\frac{c}{c^2_0\left(\vec
r(\tau)\right)}\vec {\tilde u}\left(\vec
r(\tau)\right)\right\}\right)d\tau}\quad\quad\forall s\in[a,b].
\end{equation}
In particular,
\begin{equation}\label{MaxVacFullPPNmmmffffffiuiuhjuughbghhiuijghghghhhfhhghghguygtjuuujjjkhjhjhffggiouuiiu}
A\left(\vec x_0\right)=0\;\;\text{implies}\;\; A\left(\vec
r(s)\right)=0\quad\forall s\in[a,b],\quad\text{and}\quad A\left(\vec
x_0\right)\neq 0\;\;\text{implies}\;\; A\left(\vec r(s)\right)\neq
0\quad\forall s\in[a,b].
\end{equation}
Therefore, by
\er{MaxVacFullPPNmmmffffffiuiuhjuughbghhiuijghghghhhfhhghghguygtjuuujjjkhjhjhffggiouuiiu}
we deduce that in the case of \er{ojhkk} the curve that satisfies
\er{MaxVacFullPPNmmmffffffiuiuhjuughbghhiuijghghghhhfhhghghguygtjuuujjjkk}
coincides with the ray of light that passes through the point $\vec
x_0$. So in the case of \er{ojhkk}, equality
\er{MaxVacFullPPNmmmffffffiuiuhjuughbghhiuijghghghhhfhhghghguygtjuuujjjkk}
is the equation of a ray and the vector field $\vec h_0$, defined
for every $\vec x$ by:
\begin{equation}\label{MaxVacFullPPNmmmffffffiuiuhjuughbghhiuijghghghhhfhhghghguygtjuuujjjkyuuy}
\vec h_0(\vec x):=\frac{c}{c^2_0(\vec x)}\vec {\tilde u}(\vec
x)-\nabla_{\vec x}S(\vec x)\approx\frac{c}{c^2_0(\vec x)}\,\vec
h(\vec x),
\end{equation}
is the direction of the propagation of the ray that passes through
point $\vec x$. Moreover, by
\er{MaxVacFullPPNmmmffffffiuiuhjuughbghhiuijghghghhjhjhhghyuyiyyujjljk}
$\vec h_0$ satisfies
\begin{equation}\label{MaxVacFullPPNmmmffffffiuiuhjuughbghhiuijghghghhjhjhhghyuyiyyujjljkgghhg}
|\vec h_0|^2=\frac{c^2}{c^2_0}.
\end{equation}
\begin{remark}\label{hjgjggh}
As before in remark \ref{hjgjgghjj11nnjj}, in contrast to the proof
of
\er{MaxVacFullPPNmmmffffffiuiuhjuughbghhiuijghghghhjhjhhghyuyjjjhhjhjff},
\er{MaxVacFullPPNmmmffffffiuiuhjuughbghhiuijghghghhjhjhhghyuyiyyujj}
or
\er{MaxVacFullPPNmmmffffffiuiuhjuughbghhiuijghghghhjhjhhghyuyiyyujjljk},
we do not use the Geometric Optics approximations
\er{slochangGGaaffgfgjhjggh} or \er{slochangGGaaffgfgjhjggh11} in
the proof of
\er{MaxVacFullPPNmmmffffffiuiuhjuughbghhiuijghghghhhfhjj},
\er{MaxVacFullPPNmmmffffffiuiuhjuughbghhiuijghghghhhfhhghghguygtjjjkkjjkkhh},
\er{MaxVacFullPPNmmmffffffiuiuhjuughbghhiuijghghghhhfhhghghguygtjuuujj}
and
\er{MaxVacFullPPNmmmffffffiuiuhjuughbghhiuijghghghhhfhhghghguygtjuuujjjk}.
We just need the estimation \er{ojhkk} for the proof of
\er{MaxVacFullPPNmmmffffffiuiuhjuughbghhiuijghghghhhfhhghghguygtjuuujjjk}.
So,
\er{MaxVacFullPPNmmmffffffiuiuhjuughbghhiuijghghghhhfhhghghguygtjuuujjjkhjhjhffggiouuiiu}
is still valid without assumption of the Geometric Optics
approximations \er{slochangGGaaffgfgjhjggh} or
\er{slochangGGaaffgfgjhjggh11} and thus, the vector field $\vec
h_0$, defined by
\er{MaxVacFullPPNmmmffffffiuiuhjuughbghhiuijghghghhhfhhghghguygtjuuujjjkyuuy},
is the direction of the propagation of the ray that passes through
point $\vec x$ also in the general case, provided the estimation
\er{ojhkk} holds. However, without assumption of the Geometric
Optics approximation we cannot derive
\er{MaxVacFullPPNmmmffffffiuiuhjuughbghhiuijghghghhjhjhhghyuyiyyujjljkgghhg}
anymore.
\end{remark}

Next, under the approximation \er{ojhkk}
consider a curve $\vec
r(s):[a,b]\to\mathbb{R}^3$ with endpoints $\vec r(a)=N$ and $\vec
r(b)=M$. Then integrating the square root of both sides of
\er{MaxVacFullPPNmmmffffffiuiuhjuughbghhiuijghghghhjhjhhghyuyiyyujjljk}
over the curve $\vec r(s)$ we deduce
\begin{equation}\label{MaxMedFullGGffgggyyojjhhjkhjyyiuhggjhhjhuyytytyuuytrrtghjtyuggyuighjuyioyyfgffhyuhhghzzrrhhkkk}
\int_a^b\left|\frac{c}{c^2_0\left(\vec r(s)\right)}\vec {\tilde
u}\left(\vec r(s)\right)-\nabla_{\vec x}S\left(\vec
r(s)\right)\right|\,\left|\vec
r'(s)\right|ds=\int_a^b\frac{c}{c_0\left(\vec r(s)\right)}\left|\vec
r'(s)\right|ds.
\end{equation}
Thus in particular,
\begin{equation}\label{MaxMedFullGGffgggyyojjhhjkhjyyiuhggjhhjhuyytytyuuytrrtghjtyuggyuighjuyioyyfgffhyuhhghzzrriuihhkkk}
\int_a^b\left(\frac{c}{c^2_0\left(\vec r(s)\right)}\vec {\tilde
u}\left(\vec r(s)\right)-\nabla_{\vec x}S\left(\vec
r(s)\right)\right)\cdot\vec
r'(s)ds\leq\int_a^b\frac{c}{c_0\left(\vec r(s)\right)}\left|\vec
r'(s)\right|ds,
\end{equation}
i.e.
\begin{equation}\label{MaxMedFullGGffgggyyojjhhjkhjyyiuhggjhhjhuyytytyuuytrrtghjtyuggyuighjuyioyyfgffhyuhhghzzrrkkijjhjhhkkk}
\left(-S(M)\right)-
\left(-S(N)\right)\leq\int_a^b\frac{c}{c_0\left(\vec
r(s)\right)}\left|\vec
r'(s)\right|ds-\int_a^b\frac{c}{c^2_0\left(\vec r(s)\right)}\vec
{\tilde u}\left(\vec r(s)\right)\cdot\vec r'(s)ds.
\end{equation}
Moreover, if
\begin{equation}\label{MaxMedFullGGffgggyyojjyugggjhhjzzrrhhkkk}
\frac{d\vec r}{ds}(s)=\sigma(s)\vec h_0\left(\vec
r(s)\right):=\sigma(s)\left(\frac{c}{c^2_0\left(\vec
r(s)\right)}\vec {\tilde u}\left(\vec r(s)\right)-\nabla_{\vec
x}S\left(\vec r(s)\right)\right),
\end{equation}
for some nonnegative scalar factor $\sigma=\sigma(s)$ then by
\er{MaxMedFullGGffgggyyojjyugggjhhjzzrrhhkkk} we rewrite
\er{MaxMedFullGGffgggyyojjhhjkhjyyiuhggjhhjhuyytytyuuytrrtghjtyuggyuighjuyioyyfgffhyuhhghzzrrhhkkk}
as
\begin{equation}\label{MaxMedFullGGffgggyyojjhhjkhjyyiuhggjhhjhuyytytyuuytrrtghjtyuggyuighjuyioyyfgffhyuhhghzzrrkkhhkkk}
\left(-S(M)\right)-
\left(-S(N)\right)=\int_a^b\frac{c}{c_0\left(\vec
r(s)\right)}\left|\vec
r'(s)\right|ds-\int_a^b\frac{c}{c^2_0\left(\vec r(s)\right)}\vec
{\tilde u}\left(\vec r(s)\right)\cdot\vec r'(s)ds.
\end{equation}
Thus, by comparing
\er{MaxVacFullPPNmmmffffffiuiuhjuughbghhiuijghghghhhfhhghghguygtjuuujjjkk}
with \er{MaxMedFullGGffgggyyojjyugggjhhjzzrrhhkkk} and using
\er{MaxMedFullGGffgggyyojjhhjkhjyyiuhggjhhjhuyytytyuuytrrtghjtyuggyuighjuyioyyfgffhyuhhghzzrrkkijjhjhhkkk}
and
\er{MaxMedFullGGffgggyyojjhhjkhjyyiuhggjhhjhuyytytyuuytrrtghjtyuggyuighjuyioyyfgffhyuhhghzzrrkkhhkkk},
we deduce that if we assume that the light travel from the point $N$
to the point $M$ across the curve $\vec {\tilde
r}(s):[a,b]\to\mathbb{R}^3$ such that $\vec{\tilde r(a)}=N$ and
$\vec {\tilde r}(b)=M$, then
\begin{equation}\label{MaxMedFullGGffgggyyojjhhjkhjyyiuhggjhhjhuyytytyuuytrrtghjtyuggyuighjuyioyyfgffhyuhhghzzrrkkhhkkkhhh}
\left(-S(M)\right)-
\left(-S(N)\right)=\int_a^b\frac{c}{c_0\left(\vec{\tilde
r}(s)\right)}\left|\vec {\tilde
r}'(s)\right|ds-\int_a^b\frac{c}{c^2_0\left(\vec{\tilde
r}(s)\right)}\vec {\tilde u}\left(\vec{\tilde r}(s)\right)\cdot\vec
{\tilde r}'(s)ds,
\end{equation}
and for every other curve $\vec r(s):[a,b]\to\mathbb{R}^3$ with
endpoints $\vec r(a)=N$ and $\vec r(b)=M$ we have
\begin{multline}\label{MaxMedFullGGffgggyyojjhhjkhjyyiuhggjhhjhuyytytyuuytrrtghjtyuggyuighjuyioyyfgffhyuhhghzzrrkkhhkkkhhhjhkjhh}
\int_a^b\frac{c}{c_0\left(\vec r(s)\right)}\left|\vec
r'(s)\right|ds-\int_a^b\frac{c}{c^2_0\left(\vec r(s)\right)}\vec
{\tilde u}\left(\vec r(s)\right)\cdot\vec r'(s)ds\geq\\
\int_a^b\frac{c}{c_0\left(\vec{\tilde r}(s)\right)}\left|\vec
{\tilde r}'(s)\right|ds-\int_a^b\frac{c}{c^2_0\left(\vec{\tilde
r}(s)\right)}\vec {\tilde u}\left(\vec{\tilde r}(s)\right)\cdot\vec
{\tilde r}'(s)ds.
\end{multline}
So we obtain the following Fermat Principle:
\begin{proposition}\label{gughghf}
Assume Geometric Optics approximation together with \er{ojhkk}. Then
the light that travels from point $N$ to point $M$ chooses the path
$\vec r(s):[a,b]\to\mathbb{R}^3$ with endpoints $\vec r(a)=N$ and
$\vec r(b)=M$ which minimizes the quantity:
\begin{equation}\label{MaxMedFullGGffgggyyojjhhjkhjyyiuhggjhhjhuyytytyuuytrrtghjtyuggyuighjuyioyyfgffhyuhhghzzrrkkhhkkkhhhjhkjhhghhggh}
J\left(\vec r(\cdot)\right):=\int_a^bn\left(\vec
r(s)\right)\left|\vec r'(s)\right|ds-\int_a^b
\frac{1}{c}n^2\left(\vec r(s)\right)\vec {\tilde u}\left(\vec
r(s)\right)\cdot\vec r'(s)ds,
\end{equation}
where we set the refraction index:
\begin{equation}\label{MaxMedFullGGffgggyyojjhhjkhjyyiuhggjhhjhuyytytyuuytrrtghjtyuggyuighjuyioyyfgffhyuhhghzzrrkkhhkkkhhhjhkjhhghhgghiuiu1}
n\left(\vec x\right):=\frac{c}{c_0\left(\vec x\right)}.
\end{equation}
Moreover, if $\vec r(s):[a,b]\to\mathbb{R}^3$ with endpoints $\vec
r(a)=N$ and $\vec r(b)=M$ is the real path of the light, then:
\begin{equation}\label{MaxMedFullGGffgggyyojjhhjkhjyyiuhggjhhjhuyytytyuuytrrtghjtyuggyuighjuyioyyfgffhyuhhghzzrrkkhhkkkoiioio}
\left(-S(M)\right)- \left(-S(N)\right)=\int_a^bn\left(\vec
r(s)\right)\left|\vec r'(s)\right|ds-\int_a^b
\frac{1}{c}n^2\left(\vec r(s)\right)\vec {\tilde u}\left(\vec
r(s)\right)\cdot\vec r'(s)ds.
\end{equation}
\end{proposition}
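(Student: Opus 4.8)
The plan is to reduce the proposition to a single application of the Cauchy--Schwarz inequality, after rewriting the functional $J$ in \er{MaxMedFullGGffgggyyojjhhjkhjyyiuhggjhhjhuyytytyuuytrrtghjtyuggyuighjuyioyyfgffhyuhhghzzrrkkhhkkkhhhjhkjhhghhggh} in terms of the beam direction field $\vec h$ and an exact differential. First I would use the eikonal equation \er{MaxVacFullPPNmmmffffffiuiuhjuughbghhiuijghghghhjhjhhghyuyiyyujjljkgghhg} to record that the refraction index of \er{MaxMedFullGGffgggyyojjhhjkhjyyiuhggjhhjhuyytytyuuytrrtghjtyuggyuighjuyioyyfgffhyuhhghzzrrkkhhkkkhhhjhkjhhghhgghiuiu1} satisfies $n=c/c_0=|\vec h|$, with $\vec h$ the field defined in \er{MaxVacFullPPNmmmffffffiuiuhjuughbghhiuijghghghhhfhhghghguygtjuuujjjkyuuy}. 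The key algebraic observation is that $\tfrac{1}{c}n^2\vec {\tilde u}=\tfrac{c}{c^2_0}\vec {\tilde u}=\vec h+\nabla_{\vec x}S$, so that for every admissible curve $\vec r(s):[a,b]\to\mathbb{R}^3$ with $\vec r(a)=N$ and $\vec r(b)=M$ one has
$$J\left(\vec r(\cdot)\right)=\int_a^b\Big(\big|\vec h(\vec r)\big|\,\big|\vec r'\big|-\vec h(\vec r)\cdot\vec r'\Big)\,ds-\int_a^b\nabla_{\vec x}S(\vec r)\cdot\vec r'\,ds.$$

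Next I would observe that the last integral is the integral of an exact differential, hence equals $S(M)-S(N)$ and depends only on the endpoints $N$ and $M$, not on the chosen path. This gives
$$J\left(\vec r(\cdot)\right)=\int_a^b\Big(\big|\vec h(\vec r)\big|\,\big|\vec r'\big|-\vec h(\vec r)\cdot\vec r'\Big)\,ds+\Big(\big(-S(M)\big)-\big(-S(N)\big)\Big).$$
Since by Cauchy--Schwarz the integrand $|\vec h|\,|\vec r'|-\vec h\cdot\vec r'$ is nonnegative for every curve, this yields at once the lower bound $J(\vec r(\cdot))\ge(-S(M))-(-S(N))$, valid uniformly over all curves joining $N$ to $M$; the right-hand side being common to all such curves, any curve attaining it is automatically a minimizer of $J$.

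It then remains to identify the minimizers with the physical light paths. Equality in Cauchy--Schwarz holds precisely when $\vec r'(s)=\sigma(s)\,\vec h(\vec r(s))$ for some nonnegative factor $\sigma(s)\ge 0$, which is exactly the condition \er{MaxMedFullGGffgggyyojjyugggjhhjzzrrhhkkk} characterizing a beam. By \er{MaxVacFullPPNmmmffffffiuiuhjuughbghhiuijghghghhhfhhghghguygtjuuujjjkk} the characteristic curves of the beam-propagation equation satisfy $d\vec r/ds=\vec h(\vec r)$, and by \er{MaxVacFullPPNmmmffffffiuiuhjuughbghhiuijghghghhhfhhghghguygtjuuujjjkhjhjhffggiouuiiu} the amplitude does not vanish along such a curve, so these characteristics are the actual beams of light. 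Consequently the real path realizes the lower bound, which simultaneously shows that it minimizes $J$ and establishes the claimed equality in \er{MaxMedFullGGffgggyyojjhhjkhjyyiuhggjhhjhuyytytyuuytrrtghjtyuggyuighjuyioyyfgffhyuhhghzzrrkkhhkkkoiioio}.

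The computation is essentially routine, so I do not expect a genuine obstacle; the single point requiring care is the identity $\tfrac{1}{c}n^2\vec {\tilde u}=\vec h+\nabla_{\vec x}S$, which ties the functional to the eikonal $S$ and renders the $\nabla_{\vec x}S$-term an exact differential. It is precisely this step that converts a path-integral minimization into a pointwise Cauchy--Schwarz statement, and it is where the eikonal equation and the definition of $\vec h$ enter in an essential way.
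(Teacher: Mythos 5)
Your proposal is correct and follows essentially the same route as the paper: the paper likewise integrates the eikonal identity $|\vec h|=c/c_0=n$ along the curve, applies $\vec h\cdot\vec r'\leq|\vec h|\,|\vec r'|$, uses that $\nabla_{\vec x}S\cdot\vec r'$ integrates exactly to $S(M)-S(N)$, and characterizes equality by $\vec r'(s)=\sigma(s)\vec h(\vec r(s))$ with $\sigma\geq 0$, identifying such curves with the actual beams via the characteristic equation and the non-vanishing of the amplitude. Your reorganization (rewriting $J$ first and then invoking Cauchy--Schwarz pointwise) is only a cosmetic reordering of the same steps.
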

In particular, by Proposition \ref{gughghf} the path of travel of
the light satisfies the Euler-Lagrange equation for the functional
$J\left(\vec r(\cdot)\right)$:
\begin{multline}\label{MaxMedFullGGffgggyyojjhhjkhjyyiuhggjhhjhuyytytyuuytrrtghjtyuggyuighjuyioyyfgffhyuhhghzzrrkkhhkkkhhhjhkjhhghhgghiuiu2}
\frac{d}{ds}\left(n\left(\vec r(s)\right)\frac{1}{\left|\vec
r'(s)\right|}\vec r'(s)-\frac{1}{c}n^2\left(\vec r(s)\right)\vec
{\tilde u}\left(\vec r(s)\right)\right)=\\
\left|\vec r'(s)\right|\nabla_{\vec x}n\left(\vec
r(s)\right)-\frac{2}{c}\left(\vec {\tilde u}\left(\vec
r(s)\right)\cdot\vec r'(s)\right)n\left(\vec r(s)\right)\nabla_{\vec
x}n\left(\vec r(s)\right)-\frac{1}{c}n^2\left(\vec
r(s)\right)\left\{d_{\vec x}\vec {\tilde u}\left(\vec
r(s)\right)\right\}^T\cdot\vec r'(s),
\end{multline}
that we rewrite as:
\begin{multline}\label{MaxMedFullGGffgggyyojjhhjkhjyyiuhggjhhjhuyytytyuuytrrtghjtyuggyuighjuyioyyfgffhyuhhghzzrrkkhhkkkhhhjhkjhhghhgghiuiuhhjhj1}
\frac{1}{\left|\vec r'(s)\right|}\frac{d}{ds}\left(n\left(\vec
r(s)\right)\frac{1}{\left|\vec r'(s)\right|}\vec r'(s)\right)=\\
\nabla_{\vec x}n\left(\vec r(s)\right)+\frac{1}{c}n^2\left(\vec
r(s)\right)\left(d_{\vec x}\vec {\tilde u}\left(\vec
r(s)\right)-\left\{d_{\vec x}\vec {\tilde u}\left(\vec
r(s)\right)\right\}^T\right)\cdot\left(\frac{1}{\left|\vec
r'(s)\right|}\vec r'(s)\right)\\
+\frac{2}{c}n\left(\vec r(s)\right)\left\{\vec {\tilde u}\left(\vec
r(s)\right)\otimes\nabla_{\vec x}n\left(\vec
r(s)\right)-\nabla_{\vec x}n\left(\vec r(s)\right)\otimes\vec
{\tilde u}\left(\vec
r(s)\right)\right\}\cdot\left(\frac{1}{\left|\vec r'(s)\right|}\vec
r'(s)\right).
\end{multline}
Therefor by \er{apfrm9} and
\er{MaxMedFullGGffgggyyojjhhjkhjyyiuhggjhhjhuyytytyuuytrrtghjtyuggyuighjuyioyyfgffhyuhhghzzrrkkhhkkkhhhjhkjhhghhgghiuiuhhjhj1}
we deduce the differential equation of the path of light:
\begin{multline}\label{MaxMedFullGGffgggyyojjhhjkhjyyiuhggjhhjhuyytytyuuytrrtghjtyuggyuighjuyioyyfgffhyuhhghzzrrkkhhkkkhhhjhkjhhghhgghiuiuhhjhj}
\frac{d}{d\lambda}\left(n\left(\vec r\right)\frac{d\vec
r}{d\lambda}\right)=\frac{1}{c}n^2\left(\vec
r\right)\left(curl_{\vec x}\vec {\tilde u}\left(\vec
r\right)\right)\times\frac{d\vec r}{d\lambda}\\ +\nabla_{\vec
x}n\left(\vec r\right) +\frac{2}{c}n\left(\vec r\right)\left\{\vec
{\tilde u}\left(\vec r\right)\otimes\nabla_{\vec x}n\left(\vec
r\right)-\nabla_{\vec x}n\left(\vec r\right)\otimes\vec {\tilde
u}\left(\vec r\right)\right\}\cdot\frac{d\vec r}{d\lambda},
\end{multline}
where
\begin{equation}\label{MaxMedFullGGffgggyyojjhhjkhjyyiuhggjhhjhuyytytyuuytrrtghjtyuggyuighjuyioyyfgffhyuhhghzzrrkkhhkkkhhhjhkjhhghhgghiuiu}
\lambda:=\int_a^s\left|\vec r'(\tau)\right|d\tau,
\end{equation}
is the natural parameter of the curve.

Next, assume that the wave we consider has an electromagnetic
nature. Then by \er{gughhghfbvnbv} and \er{uyuyuyy} we have
\begin{equation}\label{gughhghfbvnbvyyuuyr}
c_0=c\sqrt{\kappa_0\gamma_0}\quad\text{and}\quad\vec {\tilde
u}=\left(\gamma_0\kappa_0\vec v+(1-\gamma_0\kappa_0)\vec u\right),
\end{equation}
where, $\vec u$ is the medium velocity and $\vec v$ is the local
vectorial gravitational potential. Moreover, assume that we consider
light traveling in some region either filled with the resting medium
of constant dielectric permeability $\gamma_0$ and magnetic
permeability $\kappa_0$ or in the vacuum. Then by
\er{gughhghfbvnbvyyuuyr} and
\er{MaxMedFullGGffgggyyojjhhjkhjyyiuhggjhhjhuyytytyuuytrrtghjtyuggyuighjuyioyyfgffhyuhhghzzrrkkhhkkkhhhjhkjhhghhgghiuiu1}
we have:
\begin{equation}\label{gughhghfbvnbvyyuuyrhiyyuiuu}
n=\frac{1}{\sqrt{\kappa_0\gamma_0}}\;\;\;\text{is a
constant,}\quad\text{and}\quad\vec {\tilde u}=\gamma_0\kappa_0\vec
v,
\end{equation}
Then by \er{gughhghfbvnbvyyuuyrhiyyuiuu} we rewrite
\er{MaxMedFullGGffgggyyojjhhjkhjyyiuhggjhhjhuyytytyuuytrrtghjtyuggyuighjuyioyyfgffhyuhhghzzrrkkhhkkkhhhjhkjhhghhgghiuiuhhjhj}
as:
\begin{equation}\label{MaxMedFullGGffgggyyojjhhjkhjyyiuhggjhhjhuyytytyuuytrrtghjtyuggyuighjuyioyyfgffhyuhhghzzrrkkhhkkkhhhjhkjhhghhgghiuiuhhjhjiuiuyu}
\frac{d^2\vec
r}{d\lambda^2}=\frac{1}{c}\sqrt{\gamma_0\kappa_0}\left(curl_{\vec
x}\vec v\left(\vec r\right)\right)\times\frac{d\vec
r}{d\lambda}=\frac{1}{c}\,\frac{1}{n}\,\left(curl_{\vec x}\vec
v\left(\vec r\right)\right)\times\frac{d\vec r}{d\lambda}.
\end{equation}
In particular, if our coordinate system is inertial, or more
generally non-rotating, then $curl_{\vec x}\vec v=0$ and we deduce
that the path of the light from the point $N$ to the point $M$ is
the direct line connecting these points, provided we take in the
account estimation \er{ojhkk}.

On the other hand, if our system is rotating, then, since $\vec v$
is a speed-like vector field, we clearly deduce:
\begin{equation}\label{MaxMedFullGGffgggyyojjhhjkhjyyiuhggjhhjhuyytytyuuytrrtghjtyuggyuighjuyioyyfgffhyuhhghzzrrkkhhkkkhhhjhkjhhghhgghiuiuhhjhjiuiuyujjk}
curl_{\vec x}\vec v=-2\vec w,
\end{equation}
where $\vec w$ is the vector of the angular speed of rotation of our
coordinate system. Thus by inserting
\er{MaxMedFullGGffgggyyojjhhjkhjyyiuhggjhhjhuyytytyuuytrrtghjtyuggyuighjuyioyyfgffhyuhhghzzrrkkhhkkkhhhjhkjhhghhgghiuiuhhjhjiuiuyujjk}
into
\er{MaxMedFullGGffgggyyojjhhjkhjyyiuhggjhhjhuyytytyuuytrrtghjtyuggyuighjuyioyyfgffhyuhhghzzrrkkhhkkkhhhjhkjhhghhgghiuiuhhjhjiuiuyu}
we deduce:
\begin{equation}\label{MaxMedFullGGffgggyyojjhhjkhjyyiuhggjhhjhuyytytyuuytrrtghjtyuggyuighjuyioyyfgffhyuhhghzzrrkkhhkkkhhhjhkjhhghhgghiuiuhhjhjiuiuyuyyu}
\frac{d^2\vec
r}{d\lambda^2}=-\frac{2}{c}\sqrt{\gamma_0\kappa_0}\,\vec
w\times\frac{d\vec r}{d\lambda}.
\end{equation}
In particular, by
\er{MaxMedFullGGffgggyyojjhhjkhjyyiuhggjhhjhuyytytyuuytrrtghjtyuggyuighjuyioyyfgffhyuhhghzzrrkkhhkkkhhhjhkjhhghhgghiuiuhhjhjiuiuyuyyu}
if we consider that $\vec w=(0,0,w)$ and $\vec r=(x,y,z)$, then
there exist three dimensionless constants $C_1$, $C_2$ and $C_3$
such that
\begin{equation}\label{MaxMedFullGGffgggyyojjhhjkhjyyiuhggjhhjhuyytytyuuytrrtghjtyuggyuighjuyioyyfgffhyuhhghzzrrkkhhkkkhhhjhkjhhghhgghiuiuhhjhjiuiuyuyyukkl}
\begin{cases}
\frac{dx}{d\lambda}=-C_1\sin{\left(\frac{2w}{c}\sqrt{\gamma_0\kappa_0}\,\lambda\right)}+C_2\cos{\left(\frac{2w}{c}\sqrt{\gamma_0\kappa_0}\,\lambda\right)}
\\
\frac{dy}{d\lambda}=-C_1\cos{\left(\frac{2w}{c}\sqrt{\gamma_0\kappa_0}\,\lambda\right)}-C_2\sin{\left(\frac{2w}{c}\sqrt{\gamma_0\kappa_0}\,\lambda\right)}
\\
\frac{dz}{d\lambda}=C_3,
\end{cases}
\end{equation}
and moreover, since $\lambda$ is a natural parameter, the constants
satisfy:
\begin{equation}\label{MaxMedFullGGffgggyyojjhhjkhjyyiuhggjhhjhuyytytyuuytrrtghjtyuggyuighjuyioyyfgffhyuhhghzzrrkkhhkkkhhhjhkjhhghhgghiuiuhhjhjiuiuyuyyuojk}
C^2_1+C^2_2+C^2_3=1.
\end{equation}
Then by
\er{MaxMedFullGGffgggyyojjhhjkhjyyiuhggjhhjhuyytytyuuytrrtghjtyuggyuighjuyioyyfgffhyuhhghzzrrkkhhkkkhhhjhkjhhghhgghiuiuhhjhjiuiuyuyyukkl}
there exist three additional constants $D_1$, $D_2$ and $D_3$ such
that
\begin{equation}\label{MaxMedFullGGffgggyyojjhhjkhjyyiuhggjhhjhuyytytyuuytrrtghjtyuggyuighjuyioyyfgffhyuhhghzzrrkkhhkkkhhhjhkjhhghhgghiuiuhhjhjiuiuyuyyukklghhg}
\begin{cases}
x(\lambda)=C_1\frac{c}{2w}\sqrt{\gamma_0\kappa_0}\,\left(\cos{\left(\frac{2w}{c}\sqrt{\gamma_0\kappa_0}\,\lambda\right)}-1\right)+C_2\frac{c}{2w}\sqrt{\gamma_0\kappa_0}\,\sin{\left(\frac{2w}{c}\sqrt{\gamma_0\kappa_0}\,\lambda\right)}+D_1
\\
y(\lambda)=-C_1\frac{c}{2w}\sqrt{\gamma_0\kappa_0}\,\sin{\left(\frac{2w}{c}\sqrt{\gamma_0\kappa_0}\,\lambda\right)}+C_2\frac{c}{2w}\sqrt{\gamma_0\kappa_0}\,\left(\cos{\left(\frac{2w}{c}\sqrt{\gamma_0\kappa_0}\,\lambda\right)}-1\right)+D_2
\\
z(\lambda)=C_3\lambda+D_3.
\end{cases}
\end{equation}
So, the curve in
\er{MaxMedFullGGffgggyyojjhhjkhjyyiuhggjhhjhuyytytyuuytrrtghjtyuggyuighjuyioyyfgffhyuhhghzzrrkkhhkkkhhhjhkjhhghhgghiuiuhhjhjiuiuyuyyukklghhg}
is the trajectory of the light in the rotating coordinate system,
provided we assume \er{ojhkk}. In particular, by
\er{MaxMedFullGGffgggyyojjhhjkhjyyiuhggjhhjhuyytytyuuytrrtghjtyuggyuighjuyioyyfgffhyuhhghzzrrkkhhkkkhhhjhkjhhghhgghiuiuhhjhjiuiuyuyyukklghhg}
and
\er{MaxMedFullGGffgggyyojjhhjkhjyyiuhggjhhjhuyytytyuuytrrtghjtyuggyuighjuyioyyfgffhyuhhghzzrrkkhhkkkhhhjhkjhhghhgghiuiuhhjhjiuiuyuyyukkl}
we have:
\begin{equation}\label{MaxMedFullGGffgggyyojjhhjkhjyyiuhggjhhjhuyytytyuuytrrtghjtyuggyuighjuyioyyfgffhyuhhghzzrrkkhhkkkhhhjhkjhhghhgghiuiuhhjhjiuiuyuyyuojkjkkkhgugg}
\begin{cases}
x(0)=D_1,\quad y(0)=D_2,\quad z(0)=D_3,\\
\frac{dx}{d\lambda}(0)=C_2,\quad\frac{dy}{d\lambda}(0)=-C_1,\quad\frac{dz}{d\lambda}(0)=C_3.
\end{cases}
\end{equation}
%
%
%
%
%
%
The constants $C_1,C_2,C_3,D_1,D_2,D_3$ can be determined either by
the initial data
\er{MaxMedFullGGffgggyyojjhhjkhjyyiuhggjhhjhuyytytyuuytrrtghjtyuggyuighjuyioyyfgffhyuhhghzzrrkkhhkkkhhhjhkjhhghhgghiuiuhhjhjiuiuyuyyuojkjkkkhgugg}
or by the beginning and the ending points $N$ and $M$ of the curve.

\subsubsection{The laws of reflection and refraction}\label{rrGO}
Next consider a
monochromatic wave of the frequency $\nu=\omega/(2\pi)$
characterized by:
\begin{equation}\label{MaxVacFullPPNmmmffffffiuiuhjuughbghhuiiujjhhjjhjhhj}
U(\vec x,t)=A(\vec x)e^{ik_0S(\vec x,t)},\quad\text{where}\quad
k_0=\frac{\omega}{c}\quad\text{and}\quad\frac{\partial S}{\partial
t}=c\,,
\end{equation}
and, consistently with
\er{MaxVacFullPPNmmmffffffiuiuhjuughbghhiuijghghghhhfhhghghguygtjuuujjjkyuuy}
consider a direction field:
\begin{equation}\label{MaxVacFullPPNmmmffffffiuiuhjuughbghhiuijghghghhhfhhghghguygtjuuujjjkyuuykjkjj}
\vec h_0(\vec x)=\frac{c}{c^2_0(\vec x)}\vec {\tilde u}(\vec
x)-\nabla_{\vec x}S(\vec x)\approx\frac{c}{c^2_0(\vec x)}\vec h(\vec
x).
\end{equation}
Furthermore, assume that this wave undergoes reflection and/or
refraction on the stationary (time independent) surface
$\mathcal{T}$ with the outcoming unit normal $\vec n$, separating
two regions characterized respectively by $c_0=c^{(1)}_0$ and $\vec
{\tilde u}=\vec {\tilde u}_1$ and by $c^{(2)}_0$ and $\vec {\tilde
u}_2$, with the formation of the reflected wave (of the same
frequency), characterized by:
\begin{equation}\label{MaxVacFullPPNmmmffffffiuiuhjuughbghhuiiujjhhjjhjhhjugh}
U_1(\vec x,t)=A_1(\vec x)e^{ik_0S_1(\vec
x,t)},\quad\text{where}\quad\frac{\partial S_1}{\partial t}=c\,,
\end{equation}
and by a direction field:
\begin{equation}\label{MaxVacFullPPNmmmffffffiuiuhjuughbghhiuijghghghhhfhhghghguygtjuuujjjkyuuykjkjjhjhhj}
\vec h_1(\vec x)=\frac{c}{c^2_0(\vec x)}\vec {\tilde u}(\vec
x)-\nabla_{\vec x}S_1(\vec x),
\end{equation}
and formation of the refracted wave (of the same frequency),
characterized by:
\begin{equation}\label{MaxVacFullPPNmmmffffffiuiuhjuughbghhuiiujjhhjjhjhhj1}
U_2(\vec x,t)=A_2(\vec x)e^{ik_0S_2(\vec
x,t)},\quad\text{where}\quad\frac{\partial S_2}{\partial t}=c\,.
\end{equation}
and by a direction field:
\begin{equation}\label{MaxVacFullPPNmmmffffffiuiuhjuughbghhiuijghghghhhfhhghghguygtjuuujjjkyuuykjkjj1}
\vec h_2(\vec x)=\frac{c}{\left(c^{(2)}_0(\vec x)\right)^2}\vec
{\tilde u_2}(\vec x)-\nabla_{\vec x}S_2(\vec x).
\end{equation}
Then the boundary conditions of $U$, $U_1$ and $U_2$ depend on the
physical meaning of these fields. However, one of the
\underline{necessary} conditions should be that
\begin{equation}\label{MaxMedFullGGffgggyyojjyugggjhhjiiuuiyuyuyyuyuzzhhkkk}
S_1(\vec x,t)=S_2(\vec x,t)+C_2=S(\vec x,t)\quad\quad\forall\vec
x\in\mathcal{T},
\end{equation}
where $C_2$ is a real constant. In particular
\er{MaxMedFullGGffgggyyojjyugggjhhjiiuuiyuyuyyuyuzzhhkkk} implies
\begin{equation}\label{MaxMedFullGGffgggyyojjyugggjhhjiiuuiyuyuyyuyuzzhhkkkgdfg}
\nabla_{\vec x} S_1-\left(\vec n\cdot \nabla_{\vec x} S_1\right)\vec
n
=\nabla_{\vec x} S_2-\left(\vec n\cdot \nabla_{\vec x}
S_2\right)\vec n=\nabla_{\vec x} S-\left(\vec n\cdot \nabla_{\vec x}
S\right)\vec n\quad\quad\forall\vec x\in\mathcal{T}.
\end{equation}
In particular, for every point on the surface $\mathcal{T}$ vectors
$\nabla_{\vec x} S_1$ and $\nabla_{\vec x} S_2$ lie in the plane
formed by vectors $\vec n$ and $\nabla_{\vec x} S$. Moreover, by
\er{MaxVacFullPPNmmmffffffiuiuhjuughbghhiuijghghghhhfhhghghguygtjuuujjjkyuuykjkjj},
\er{MaxVacFullPPNmmmffffffiuiuhjuughbghhiuijghghghhhfhhghghguygtjuuujjjkyuuykjkjjhjhhj}
and \er{MaxMedFullGGffgggyyojjyugggjhhjiiuuiyuyuyyuyuzzhhkkkgdfg} we
have
\begin{equation}\label{MaxMedFullGGffgggyyojjyugggjhhjiiuuiyuyuyyuyuzzhhkkkgdfghgghgh}
\vec h_1-\left(\vec n\cdot \vec h_1\right)\vec n=\vec h_0-\left(\vec
n\cdot \vec h_0\right)\vec n\quad\quad\forall\vec x\in\mathcal{T},
\end{equation}
and in particular, for every point on the surface $\mathcal{T}$
vector $\vec h_1$ lies in the plane formed by vectors $\vec n$ and
$\vec h_0$.
Next, assume that the approximate equations in
\er{MaxVacFullPPNmmmffffffiuiuhjuughbghhiuijghghghhjhjhhghyuyiyyujjljk}
and
\er{MaxVacFullPPNmmmffffffiuiuhjuughbghhiuijghghghhhfhhghghguygtjuuujjjk}
are valid in every of two regions on the both sides of
$\mathcal{T}$. Then by
\er{MaxVacFullPPNmmmffffffiuiuhjuughbghhiuijghghghhjhjhhghyuyiyyujjljkgghhg}
we have
\begin{equation}\label{MaxVacFullPPNmmmffffffiuiuhjuughbghhiuijghghghhjhjhhghyuyiyyujjljkgghhguyuy}
|\vec h_1|=|\vec h_0|=\frac{c}{c_0}.
\end{equation}
Then, since $\vec h_1\neq\vec h_0$, by
\er{MaxMedFullGGffgggyyojjyugggjhhjiiuuiyuyuyyuyuzzhhkkkgdfghgghgh}
and
\er{MaxVacFullPPNmmmffffffiuiuhjuughbghhiuijghghghhjhjhhghyuyiyyujjljkgghhguyuy}
we deduce
\begin{equation}\label{MaxMedFullGGffgggyyojjyugggjhhjiiuuiyuyuyyuyuzzhhkkkgdfghgghghijhjhhkh}
\vec n\cdot \vec h_1=-\vec n\cdot \vec h_0\quad\quad\forall\vec
x\in\mathcal{T}.
\end{equation}
So, by
\er{MaxVacFullPPNmmmffffffiuiuhjuughbghhiuijghghghhjhjhhghyuyiyyujjljkgghhguyuy}
and
\er{MaxMedFullGGffgggyyojjyugggjhhjiiuuiyuyuyyuyuzzhhkkkgdfghgghghijhjhhkh}
we obtain the law of reflection: vector $\vec h_1$ lies in the plane
formed by vectors $\vec n$ and $\vec h_0$, and we have:
\begin{equation}\label{MaxMedFulljhhjjjjj}
\theta\left(\vec h_0,-\vec n\right)=\theta_1\left(\vec h_1,\vec
n\right)
\end{equation}
where $\theta\left(\vec h_0,-\vec n\right)$ is the angle between the
incoming ray direction $\vec h_0$ and the incoming normal to the
surface $-\vec n$ and $\theta_1\left(\vec h_1,\vec n\right)$ is the
angle between the reflected ray direction $\vec h_1$ and the
outcoming normal $\vec n$.

Next assume that the wave we consider in
\er{MaxVacFullPPNmmmffffffiuiuhjuughbghhuiiujjhhjjhjhhj} has an
electromagnetic nature. Then by \er{gughhghfbvnbvyyuuyr} we have
\begin{equation}\label{gughhghfbvnbvyyuuy}
c_0=c\sqrt{\kappa_0\gamma_0}\quad\text{and}\quad\vec {\tilde
u}=\left(\gamma_0\kappa_0\vec v+(1-\gamma_0\kappa_0)\vec u\right),
\end{equation}
where, $\vec u$ is the medium velocity and $\vec v$ is the local
vectorial gravitational potential. Similarly, on the second side of
surface $\mathcal{T}$ we have
\begin{equation}\label{gughhghfbvnbvyyuuyGGHHG}
c^{(2)}_0=c\sqrt{\kappa^{(2)}_0\gamma^{(2)}_0}\quad\text{and}\quad\vec
{\tilde u}_2=\left(\gamma^{(2)}_0\kappa^{(2)}_0\vec
v+(1-\gamma^{(2)}_0)\kappa^{(2)}_0\vec u^{(2)}\right),
\end{equation}
where, $\vec u^{(2)}$ is the medium velocity on the second side of
surface $\mathcal{T}$. Furthermore, assume that the medium rests on
the both sides of surface $\mathcal{T}$. Then in this particular
case we rewrite \er{gughhghfbvnbvyyuuy} and
\er{gughhghfbvnbvyyuuyGGHHG} as
\begin{equation}\label{gughhghfbvnbvyyuuyggg}
c_0=c\sqrt{\kappa_0\gamma_0}\quad\text{and}\quad\vec {\tilde
u}=\gamma_0\kappa_0\vec v,
\end{equation}
and
\begin{equation}\label{gughhghfbvnbvyyuuyGGHHGhjhjh}
c^{(2)}_0=c\sqrt{\kappa^{(2)}_0\gamma^{(2)}_0}\quad\text{and}\quad\vec
{\tilde u}_2=\gamma^{(2)}_0\kappa^{(2)}_0\vec v,
\end{equation}
Then in particular, by \er{gughhghfbvnbvyyuuyggg} and
\er{gughhghfbvnbvyyuuyGGHHGhjhjh} we deduce
\begin{equation}\label{gughhghfbvnbvyyuuyGGHHGhjhjhhhjhj}
\frac{c}{\left(c^{(2)}_0\right)^2}\vec {\tilde
u}_2=\frac{c}{c^2_0}\vec {\tilde u}=\frac{1}{c}\,\vec v.
\end{equation}
Thus, by inserting
\er{MaxVacFullPPNmmmffffffiuiuhjuughbghhiuijghghghhhfhhghghguygtjuuujjjkyuuykjkjj}
and \er{gughhghfbvnbvyyuuyGGHHGhjhjhhhjhj} into
\er{MaxMedFullGGffgggyyojjyugggjhhjiiuuiyuyuyyuyuzzhhkkkgdfg}, we
deduce:
\begin{equation}\label{MaxMedFullGGffgggyyojjyugggjhhjiiuuiyuyuyyuyuzzhhkkkgdfg1}
\vec h_2-\left(\vec n\cdot \vec h_2\right)\vec n=\vec h_0-\left(\vec
n\cdot \vec h_0\right)\vec n\quad\quad\forall\vec x\in\mathcal{T},
\end{equation}
and in particular, for every point on the surface $\mathcal{T}$
vector $\vec h_2$ lies in the plane formed by vectors $\vec n$ and
$\vec h_0$. On the other hand by
\er{MaxVacFullPPNmmmffffffiuiuhjuughbghhiuijghghghhjhjhhghyuyiyyujjljkgghhg}
we have:
\begin{equation}\label{MaxVacFullPPNmmmffffffiuiuhjuughbghhiuijghghghhjhjhhghyuyiyyujjljkgghhglkjlkjkjljljhhjj}
|\vec h_0|=\frac{c}{c_0}\quad\text{and}\quad|\vec
h_2|=\frac{c}{c^{(2)}_0}.
\end{equation}
So, by
\er{MaxMedFullGGffgggyyojjyugggjhhjiiuuiyuyuyyuyuzzhhkkkgdfg1} and
\er{MaxVacFullPPNmmmffffffiuiuhjuughbghhiuijghghghhjhjhhghyuyiyyujjljkgghhglkjlkjkjljljhhjj},
 we have the
Snell's law of refraction: vector $\vec h_2$ lies in the plane
formed by vectors $\vec n$ and $\vec h_0$, and we have:
\begin{equation}\label{MaxMedFulljhhjjjjjhhhj}
n\sin{\left(\theta\left(\vec h_0,\vec
n\right)\right)}=n_2\sin{\left(\theta_2\left(\vec h_2,\vec
n\right)\right)}
\end{equation}
where $\theta\left(\vec h_0,\vec n\right)$ is the angle between the
incoming ray direction $\vec h_0$ and the normal to the surface
$\vec n$, $\theta_2\left(\vec h_2,\vec n\right)$ is the angle
between the refracted ray direction $\vec h_2$ and the normal $\vec
n$ and as in
\er{MaxMedFullGGffgggyyojjhhjkhjyyiuhggjhhjhuyytytyuuytrrtghjtyuggyuighjuyioyyfgffhyuhhghzzrrkkhhkkkhhhjhkjhhghhgghiuiu1}
we set refraction indexes:
\begin{equation}\label{MaxMedFullGGffgggyyojjhhjkhjyyiuhggjhhjhuyytytyuuytrrtghjtyuggyuighjuyioyyfgffhyuhhghzzrrkkhhkkkhhhjhkjhhghhgghiuiujjkjk}
n:=\frac{c}{c_0}\quad\text{and}\quad n_2:=\frac{c}{c^{(2)}_0}.
\end{equation}

\subsubsection{Sagnac effect}\label{seGO}
Assume again the monochromatic electromagnetic wave of the frequency
$\nu=\omega/(2\pi)$ characterized by:
\begin{equation}\label{MaxVacFullPPNmmmffffffiuiuhjuughbghhuiiujjhhjjhjhhjhjjhhjhjjh}
U(\vec x,t)=A(\vec x,t)e^{iT(\vec x,t)}=A(\vec x,t)e^{ik_0S(\vec
x,t)},\quad\text{where}\quad
k_0=\frac{\omega}{c}\quad\text{and}\quad\frac{\partial S}{\partial
t}=c\,.
\end{equation}
Then by \er{gughhghfbvnbvyyuuyr} we have
\begin{equation}\label{gughhghfbvnbvyyuuyrhjhjhj}
c_0=c\sqrt{\kappa_0\gamma_0}\quad\text{and}\quad\vec {\tilde
u}=\left(\gamma_0\kappa_0\vec v+(1-\gamma_0\kappa_0)\vec u\right),
\end{equation}
where, $\vec u$ is the medium velocity and $\vec v$ is the local
vectorial gravitational potential. Moreover, assume again that we
consider light traveling in some region either filled with the
resting medium of constant dielectric permeability $\gamma_0$ and
magnetic permeability $\kappa_0$ or in the vacuum. Then by
\er{gughhghfbvnbvyyuuyrhjhjhj} and
\er{MaxMedFullGGffgggyyojjhhjkhjyyiuhggjhhjhuyytytyuuytrrtghjtyuggyuighjuyioyyfgffhyuhhghzzrrkkhhkkkhhhjhkjhhghhgghiuiu1}
we have
\begin{equation}\label{gughhghfbvnbvyyuuyrhiyyuiuuhjh}
n=\frac{1}{\sqrt{\kappa_0\gamma_0}}\;\;\;\text{is a
constant,}\quad\text{and}\quad\vec {\tilde u}=\gamma_0\kappa_0\vec
v.
\end{equation}
Next, assume that the light travels from point $N$ to point $M$
across the curve $\vec r(s):[a,b]\to\mathbb{R}^3$ with endpoints
$\vec r(a)=N$ and $\vec r(b)=M$ undergoing possibly certain number
of reflections from mirrors during its travel. Then by
\er{MaxMedFullGGffgggyyojjhhjkhjyyiuhggjhhjhuyytytyuuytrrtghjtyuggyuighjuyioyyfgffhyuhhghzzrrkkhhkkk},
\er{gughhghfbvnbvyyuuyrhiyyuiuuhjh} and
\er{MaxMedFullGGffgggyyojjyugggjhhjiiuuiyuyuyyuyuzzhhkkk} we have:
\begin{equation}\label{MaxMedFullGGffgggyyojjhhjkhjyyiuhggjhhjhuyytytyuuytrrtghjtyuggyuighjuyioyyfgffhyuhhghzzrrkkhhkkkhjjh}
\delta(-S):=\left(-S(M^-)\right)-
\left(-S(N^+)\right)=\frac{1}{\sqrt{\kappa_0\gamma_0}}\int_a^b
\left|\vec r'(s)\right|ds-\frac{1}{c}\int_a^b\vec v\left(\vec
r(s)\right)\cdot\vec r'(s)ds.
\end{equation}
In particular, if we assume that $M=N$ i.e. our curve is closed and
moreover, our curve is the boundary of some surface $\mathcal{S}_0$,
then by Stokes Theorem we have:
\begin{multline}\label{MaxMedFullGGffgggyyojjhhjkhjyyiuhggjhhjhuyytytyuuytrrtghjtyuggyuighjuyioyyfgffhyuhhghzzrrkkhhkkkhjjhjj}
\delta(-S)=\left(-S(M^-)\right)-
\left(-S(M^+)\right)=\frac{1}{\sqrt{\kappa_0\gamma_0}}\int_a^b
\left|\vec r'(s)\right|ds-\frac{1}{c}\iint \left(curl_{\vec x}\vec
v\right)\cdot\vec n
\,d\mathcal{S}_0\\=\frac{1}{\sqrt{\kappa_0\gamma_0}}\left|\partial
\mathcal{S}_0\right|-\frac{1}{c}\iint \left(curl_{\vec x}\vec
v\right)\cdot\vec n \,d\mathcal{S}_0,
\end{multline}
where $\vec n$ is the unit normal to the surface.
In particular, if our
coordinate system is inertial, or more generally non-rotating, then
$curl_{\vec x}\vec v=0$ and by
\er{MaxMedFullGGffgggyyojjhhjkhjyyiuhggjhhjhuyytytyuuytrrtghjtyuggyuighjuyioyyfgffhyuhhghzzrrkkhhkkkhjjhjj}
we deduce
\begin{equation}\label{MaxMedFullGGffgggyyojjhhjkhjyyiuhggjhhjhuyytytyuuytrrtghjtyuggyuighjuyioyyfgffhyuhhghzzrrkkhhkkkhjjhjjhjhhjjhjjh}
\delta(-S)=\frac{1}{\sqrt{\kappa_0\gamma_0}}\left|\partial
\mathcal{S}_0\right|.
\end{equation}
On the other hand, if our system is rotating, then as in
\er{MaxMedFullGGffgggyyojjhhjkhjyyiuhggjhhjhuyytytyuuytrrtghjtyuggyuighjuyioyyfgffhyuhhghzzrrkkhhkkkhhhjhkjhhghhgghiuiuhhjhjiuiuyujjk}
we clearly deduce:
\begin{equation}\label{MaxMedFullGGffgggyyojjhhjkhjyyiuhggjhhjhuyytytyuuytrrtghjtyuggyuighjuyioyyfgffhyuhhghzzrrkkhhkkkhhhjhkjhhghhgghiuiuhhjhjiuiuyujjkjjk}
curl_{\vec x}\vec v=-2\vec w,
\end{equation}
where $\vec w$ is the vector of the angular speed of rotation of our
coordinate system. Then by
\er{MaxMedFullGGffgggyyojjhhjkhjyyiuhggjhhjhuyytytyuuytrrtghjtyuggyuighjuyioyyfgffhyuhhghzzrrkkhhkkkhhhjhkjhhghhgghiuiuhhjhjiuiuyujjkjjk}
and
\er{MaxMedFullGGffgggyyojjhhjkhjyyiuhggjhhjhuyytytyuuytrrtghjtyuggyuighjuyioyyfgffhyuhhghzzrrkkhhkkkhjjhjj}
we deduce
\begin{equation}\label{MaxMedFullGGffgggyyojjhhjkhjyyiuhggjhhjhuyytytyuuytrrtghjtyuggyuighjuyioyyfgffhyuhhghzzrrkkhhkkkhjjhjjjjk}
\delta(-S)=\frac{1}{\sqrt{\kappa_0\gamma_0}}\left|\partial
\mathcal{S}_0\right|+\frac{2}{c}\iint \vec w\cdot\vec n
\,d\mathcal{S}_0.
\end{equation}
In particular, if the surface $\mathcal{S}_0$ is a part of some
plain then we rewrite
\er{MaxMedFullGGffgggyyojjhhjkhjyyiuhggjhhjhuyytytyuuytrrtghjtyuggyuighjuyioyyfgffhyuhhghzzrrkkhhkkkhjjhjjjjk}
as
\begin{equation}\label{MaxMedFullGGffgggyyojjhhjkhjyyiuhggjhhjhuyytytyuuytrrtghjtyuggyuighjuyioyyfgffhyuhhghzzrrkkhhkkkhjjhjjjjkjjkjljkl}
\delta(-S)=\frac{1}{\sqrt{\kappa_0\gamma_0}}\left|\partial
\mathcal{S}_0\right|+\frac{2}{c} \left(\vec w\cdot\vec
n\right)\left|\mathcal{S}_0\right|.
\end{equation}
On the other hand, if the light travels across the same curve in the
opposite direction, then we must have:
\begin{equation}\label{MaxMedFullGGffgggyyojjhhjkhjyyiuhggjhhjhuyytytyuuytrrtghjtyuggyuighjuyioyyfgffhyuhhghzzrrkkhhkkkhjjhjjjjkjjkjljklpoo}
\delta(-S^-)=\frac{1}{\sqrt{\kappa_0\gamma_0}}\left|\partial
\mathcal{S}_0\right|-\frac{2}{c} \left(\vec w\cdot\vec
n\right)\left|\mathcal{S}_0\right|.
\end{equation}
Thus, by taking the difference in two cases and using
\er{MaxVacFullPPNmmmffffffiuiuhjuughbghhuiiujjhhjjhjhhjhjjhhjhjjh},
we deduce:
\begin{equation}\label{MaxMedFullGGffgggyyojjhhjkhjyyiuhggjhhjhuyytytyuuytrrtghjtyuggyuighjuyioyyfgffhyuhhghzzrrkkhhkkkhjjhjjjjkjjkjljklkkkhjhj}
\left(\delta(-T)-\delta(-T^-)\right)=k_0\left(\delta(-S)-\delta(-S^-)\right)=\frac{4\omega}{
c^2} \left(\vec w\cdot\vec n\right)\left|\mathcal{S}_0\right|.
\end{equation}
Here, $\gamma_0$ and $\kappa_0$ are the dielectric and the magnetic
permeability of the medium, $T$ is given in
\er{MaxVacFullPPNmmmffffffiuiuhjuughbghhuiiujjhhjjhjhhjhjjhhjhjjh},
$\left|\mathcal{S}_0\right|$ is the area of the flat surface bounded
by the closed path of the light, $\vec n$ is the unit normal to the
surface, $\omega$ is the frequency of the light and $\vec w$ is the
angular speed vector of the rotation of our coordinate system.
%
%
%
%
%
%

\subsubsection{Fizeau experiment}\label{seGOfz}
Assume again the monochromatic electromagnetic wave of the frequency
$\nu=\omega/(2\pi)$ characterized by:
\begin{equation}\label{MaxVacFullPPNmmmffffffiuiuhjuughbghhuiiujjhhjjhjhhjhjjhhjhjjhfz}
U(\vec x,t)=A(\vec x,t)e^{iT(\vec x,t)}=A(\vec x,t)e^{ik_0S(\vec
x,t)},\quad\text{where}\quad
k_0=\frac{\omega}{c}\quad\text{and}\quad\frac{\partial S}{\partial
t}=c\,.
\end{equation}
Then by \er{gughhghfbvnbvyyuuyr} we have
\begin{equation}\label{gughhghfbvnbvyyuuyrhjhjhjfz}
c_0=c\sqrt{\kappa_0\gamma_0}\quad\text{and}\quad\vec {\tilde
u}=\left(\gamma_0\kappa_0\vec v+(1-\gamma_0\kappa_0)\vec u\right),
\end{equation}
where, $\vec u$ is the medium velocity and $\vec v$ is the local
vectorial gravitational potential. Moreover, assume that we consider
light traveling in some region filled with the moving medium of
constant dielectric permeability $\gamma_0$ and magnetic
permeability $\kappa_0$. Then by \er{gughhghfbvnbvyyuuyrhjhjhjfz}
and
\er{MaxMedFullGGffgggyyojjhhjkhjyyiuhggjhhjhuyytytyuuytrrtghjtyuggyuighjuyioyyfgffhyuhhghzzrrkkhhkkkhhhjhkjhhghhgghiuiu1}
we have
\begin{equation}\label{gughhghfbvnbvyyuuyrhiyyuiuuhjhfz}
n=\frac{c}{c_0}=\frac{1}{\sqrt{\kappa_0\gamma_0}}\;\;\;\text{is a
constant,}\quad\text{and}\quad\vec {\tilde u}=\frac{1}{n^2}\,\vec
v+\left(1-\frac{1}{n^2}\right)\vec u.
\end{equation}
Next, assume that the light travels from point $N$ to point $M$
across the curve $\vec r(s):[a,b]\to\mathbb{R}^3$ with endpoints
$\vec r(a)=N$ and $\vec r(b)=M$ undergoing possibly certain number
of reflections from mirrors during its travel. Then, as before, by
\er{MaxMedFullGGffgggyyojjhhjkhjyyiuhggjhhjhuyytytyuuytrrtghjtyuggyuighjuyioyyfgffhyuhhghzzrrkkhhkkk},
\er{gughhghfbvnbvyyuuyrhiyyuiuuhjhfz} and
\er{MaxMedFullGGffgggyyojjyugggjhhjiiuuiyuyuyyuyuzzhhkkk} we have:
\begin{multline}\label{MaxMedFullGGffgggyyojjhhjkhjyyiuhggjhhjhuyytytyuuytrrtghjtyuggyuighjuyioyyfgffhyuhhghzzrrkkhhkkkhjjhfz}
\delta(-S):=\left(-S(M^-)\right)- \left(-S(N^+)\right)=\\
n\int_a^b \left|\vec r'(s)\right|ds-\frac{1}{c}\int_a^b\vec
v\left(\vec r(s)\right)\cdot\vec
r'(s)ds-\frac{n^2}{c}\left(1-\frac{1}{n^2}\right)\int_a^b\vec
u\left(\vec r(s)\right)\cdot\vec r'(s)ds.
\end{multline}
Next assume that, either our curve is perpendicular to the direction
of the vectorial gravitational potential $\vec v$, that happens, for
example, if our path of the light is tangent to the Earth surface,
or assume that our curve is closed, i.e. $M=N$ and moreover, our
coordinate system is inertial, or more generally non-rotating. In
particular, if we assume that $M=N$ i.e. our curve is closed and
moreover, our coordinate system is inertial, or more generally
non-rotating, then, as before, by Stokes Theorem we have:
\begin{equation}\label{MaxMedFullGGffgggyyojjhhjkhjyyiuhggjhhjhuyytytyuuytrrtghjtyuggyuighjuyioyyfgffhyuhhghzzrrkkhhkkkhjjhjjfz}
\int_a^b\vec v\left(\vec r(s)\right)\cdot\vec r'(s)ds=0.
\end{equation}
On the other hand in the case that our curve is perpendicular to the
direction of the vectorial gravitational potential $\vec v$,
\er{MaxMedFullGGffgggyyojjhhjkhjyyiuhggjhhjhuyytytyuuytrrtghjtyuggyuighjuyioyyfgffhyuhhghzzrrkkhhkkkhjjhjjfz}
also trivially follows. Therefore, by inserting
\er{MaxMedFullGGffgggyyojjhhjkhjyyiuhggjhhjhuyytytyuuytrrtghjtyuggyuighjuyioyyfgffhyuhhghzzrrkkhhkkkhjjhjjfz}
into
\er{MaxMedFullGGffgggyyojjhhjkhjyyiuhggjhhjhuyytytyuuytrrtghjtyuggyuighjuyioyyfgffhyuhhghzzrrkkhhkkkhjjhfz}
in both cases we obtain:
\begin{multline}\label{MaxMedFullGGffgggyyojjhhjkhjyyiuhggjhhjhuyytytyuuytrrtghjtyuggyuighjuyioyyfgffhyuhhghzzrrkkhhkkkhjjhfzfzfz}
\delta(-S)=\left(-S(M^-)\right)- \left(-S(N^+)\right)= n\int_a^b
\left|\vec
r'(s)\right|ds-\frac{n^2}{c}\left(1-\frac{1}{n^2}\right)\int_a^b\vec
u\left(\vec r(s)\right)\cdot\vec r'(s)ds.
\end{multline}
Then by
\er{MaxMedFullGGffgggyyojjhhjkhjyyiuhggjhhjhuyytytyuuytrrtghjtyuggyuighjuyioyyfgffhyuhhghzzrrkkhhkkkhjjhfzfzfz}
and
\er{MaxVacFullPPNmmmffffffiuiuhjuughbghhuiiujjhhjjhjhhjhjjhhjhjjhfz}
we deduce
\begin{multline}\label{MaxMedFullGGffgggyyojjhhjkhjyyiuhggjhhjhuyytytyuuytrrtghjtyuggyuighjuyioyyfgffhyuhhghzzrrkkhhkkkhjjhfzfzfzfzffz}
\delta(-T):=\left(-T(M^-)\right)-
\left(-T(N^+)\right)=k_0\delta(-S)\\= \frac{n\omega}{c}\int_a^b
\left|\vec
r'(s)\right|ds-\frac{n^2\omega}{c^2}\left(1-\frac{1}{n^2}\right)\int_a^b\vec
u\left(\vec r(s)\right)\cdot\vec r'(s)ds\\
\frac{n^2\omega}{c^2}\left(c_0\int_a^b \left|\vec
r'(s)\right|ds-\left(1-\frac{1}{n^2}\right)\int_a^b\vec u\left(\vec
r(s)\right)\cdot\vec r'(s)ds\right).
\end{multline}
In particular, if the absolute value $\left|\vec u\left(\vec
r(s)\right)\right|$ is a constant across the curve and if the angle
between $\vec r'(s)$ and $\vec u\left(\vec r(s)\right)$ is a
constant across the curve and equals to the value $\theta$ then
denoting the length of the path by $L$:
\begin{equation}\label{MaxMedFullGGffgggyyojjhhjkhjyyiuhggjhhjhuyytytyuuytrrtghjtyuggyuighjuyioyyfgffhyuhhghzzrrkkhhkkkhjjhjjfzuyy}
L:=\int_a^b \left|\vec r'(s)\right|ds,
\end{equation}
by
\er{MaxMedFullGGffgggyyojjhhjkhjyyiuhggjhhjhuyytytyuuytrrtghjtyuggyuighjuyioyyfgffhyuhhghzzrrkkhhkkkhjjhfzfzfzfzffz}
we deduce:
\begin{equation}\label{MaxMedFullGGffgggyyojjhhjkhjyyiuhggjhhjhuyytytyuuytrrtghjtyuggyuighjuyioyyfgffhyuhhghzzrrkkhhkkkhjjhfzfzfzfzffzghghgfz}
\delta(-T)=k_0\delta(-S)=\frac{\omega L
n^2}{c^2}\left(c_0-\left(1-\frac{1}{n^2}\right)|\vec
u|\cos{(\theta)}\right).
\end{equation}
Thus, if the direction of $\vec u$ coincides with the direction of
the light i.e. $\theta=0$ then
\begin{equation}\label{MaxMedFullGGffgggyyojjhhjkhjyyiuhggjhhjhuyytytyuuytrrtghjtyuggyuighjuyioyyfgffhyuhhghzzrrkkhhkkkhjjhfzfzfzfzffzghghgfzytyfz}
\delta(-T)=k_0\delta(-S)=\frac{\omega L
n^2}{c^2}\left(c_0-\left(1-\frac{1}{n^2}\right)|\vec
u|\right)\approx\frac{\omega
L}{\left(c_0+\left(1-\frac{1}{n^2}\right)|\vec u|\right)}.
\end{equation}
On the other hand, if the direction of $\vec u$ is opposite to the
direction of the light i.e. $\theta=\pi$ then
\begin{equation}\label{MaxMedFullGGffgggyyojjhhjkhjyyiuhggjhhjhuyytytyuuytrrtghjtyuggyuighjuyioyyfgffhyuhhghzzrrkkhhkkkhjjhfzfzfzfzffzghghgfzuyuyhffz}
\delta(-T)=k_0\delta(-S)=\frac{\omega L
n^2}{c^2}\left(c_0+\left(1-\frac{1}{n^2}\right)|\vec
u|\right)\approx\frac{\omega
L}{\left(c_0-\left(1-\frac{1}{n^2}\right)|\vec u|\right)}.
\end{equation}
So, in the frames of our model we explain the results of the Fizeau
experiment.

\subsubsection{Fermat Principle of Geometric Optics in the case when we cannot neglect effects of order $O\left(\frac{|\vec {\tilde
u}|^2}{c^2_0}\right)$}
Consider again a monochromatic wave of the
frequency $\nu=\omega/(2\pi)$ characterized by:
\begin{equation}\label{MaxVacFullPPNmmmffffffiuiuhjuughbghhuiiujjhhjjhjhhjyuyhh}
U(\vec x,t)=A(\vec x,t)e^{ik_0S(\vec x,t)},\quad\text{where}\quad
k_0=\frac{\omega}{c}\quad\text{and}\quad\frac{\partial S}{\partial
t}=c\,.
\end{equation}
As before, assume the validity of the rough Geometric Optics
approximation. However, assume that we cannot consider anymore the
case of the approximation, given by \er{ojhkk}, i.e. we assume that
we cannot neglect anymore effects of order $O\left(\frac{|\vec
{\tilde u}|^2}{c^2_0}\right)$. This happens, for example in the case
of the Michelson-Morley experiment. Thus instead of
\er{MaxVacFullPPNmmmffffffiuiuhjuughbghhiuijghghghhjhjhhghyuyiyyujjljk}
and
\er{MaxVacFullPPNmmmffffffiuiuhjuughbghhiuijghghghhhfhhghghguygtjuuujjjk}
we need to deal with
\er{MaxVacFullPPNmmmffffffiuiuhjuughbghhiuijghghghhjhjhhghyuyiyyujj}
 and
\er{MaxVacFullPPNmmmffffffiuiuhjuughbghhiuijghghghhhfhhghghguygtjuuujj}.
On the other hand, by
\er{MaxVacFullPPNmmmffffffiuiuhjuughbghhiuijghghghhjhjhhghyuyiyyujj}
we deduce:
\begin{multline}\label{MaxVacFullPPNmmmffffffiuiuhjuughbghhiuijghghghhjhjhhghyuyiyyujjkhgggghgg}
\left|\nabla_{\vec x}S-\frac{c}{c_0}\left(1+\frac{1}{c}\left(\vec
{\tilde u}\cdot\nabla_{\vec x}S\right)\right)\frac{\vec {\tilde
u}}{c_0}\right|^2=\frac{c^2}{c^2_0}\left(1+\frac{1}{c^2}\left|\nabla_\vec
x S\right|^2\left|\vec {\tilde u}\right|^2-\frac{1}{c^2}\left|\vec
{\tilde u}\cdot\nabla_\vec x S\right|^2\right)=\\
\frac{c^2}{c^2_0}\left(1+\frac{1}{c^2}\left|\nabla_{\vec
x}S-\frac{c}{c_0}\left(1+\frac{1}{c}\left(\vec {\tilde
u}\cdot\nabla_{\vec x}S\right)\right)\frac{\vec {\tilde
u}}{c_0}\right|^2\left|\vec {\tilde
u}\right|^2-\frac{1}{c^2}\left|\vec {\tilde
u}\cdot\left(\nabla_{\vec
x}S-\frac{c}{c_0}\left(1+\frac{1}{c}\left(\vec {\tilde
u}\cdot\nabla_{\vec x}S\right)\right)\frac{\vec {\tilde
u}}{c_0}\right)\right|^2\right).
\end{multline}
Then we rewrite
\er{MaxVacFullPPNmmmffffffiuiuhjuughbghhiuijghghghhhfhhghghguygtjuuujj}
and
\er{MaxVacFullPPNmmmffffffiuiuhjuughbghhiuijghghghhjhjhhghyuyiyyujjkhgggghgg}
as:
\begin{equation}\label{MaxVacFullPPNmmmffffffiuiuhjuughbghhiuijghghghhhfhhghghguygtjuuujjjhy}
\vec h_0\cdot\nabla_{\vec x}A=\frac{1}{2}\left(\left(\Delta_{\vec
x}S\right)-\Div_{\vec x}\left\{\frac{1}{c^2_0}\left(\vec {\tilde
u}\cdot\nabla_{\vec x}S\right)\vec {\tilde u}\right\}\right)A.
\end{equation}
and
\begin{equation}\label{MaxVacFullPPNmmmffffffiuiuhjuughbghhiuijghghghhjhjhhghyuyiyyujjkhgggghggutgg}
\left(1-\frac{\left|\vec {\tilde
u}\right|^2}{c^2_0}\right)\left|\vec h-\frac{1}{\left|\vec {\tilde
u}\right|^2}\left(\vec {\tilde u}\cdot\vec h\right)\vec {\tilde
u}\right|^2+\left|\frac{1}{\left|\vec {\tilde u}\right|}\vec {\tilde
u}\cdot\vec h\right|^2=\left|\vec h\right|^2\left(1-\frac{\left|\vec
{\tilde u}\right|^2}{c^2_0}\right)+\frac{1}{c^2_0}\left|\vec {\tilde
u}\cdot\vec h\right|^2= \frac{c^2}{c^2_0},
\end{equation}
where the vector field $\vec h_0$ defined for every $\vec x$ by:
\begin{equation}\label{MaxVacFullPPNmmmffffffiuiuhjuughbghhiuijghghghhhfhhghghguygtjuuujjjkyuuyhh}
\vec h_0(\vec x):=\frac{c}{c^2_0(\vec
x)}\left(1+\frac{1}{c}\left(\vec {\tilde u(\vec x)}\cdot\nabla_{\vec
x}S(\vec x)\right)\right)\vec {\tilde u}(\vec x)-\nabla_{\vec
x}S(\vec x) =\frac{c}{c^2_0(\vec x)}\left(1+\frac{1}{c}\left(\vec
{\tilde u(\vec x)}\cdot\nabla_{\vec x}S(\vec x)\right)\right)\vec
h(\vec x) ,
\end{equation}
is called the direction of propagation of the ray that passes
through point $\vec x$. We clarify this name bellow.
The Eikonal equation
\er{MaxVacFullPPNmmmffffffiuiuhjuughbghhiuijghghghhjhjhhghyuyiyyujjkhgggghggutgg}
and equation of the beam propagation
\er{MaxVacFullPPNmmmffffffiuiuhjuughbghhiuijghghghhhfhhghghguygtjuuujjjhy}
are two basic equations of propagation of monochromatic light in the
Geometric Optics approximation inside a moving medium or/and in the
presence of non-trivial gravitational field.

Next if we consider an arbitrary characteristic curve $\vec
r(s):[a,b]\to\mathbb{R}^3$ of equation
\er{MaxVacFullPPNmmmffffffiuiuhjuughbghhiuijghghghhhfhhghghguygtjuuujjjhy}
defined as a solution of ordinary differential equation
\begin{equation}\label{MaxVacFullPPNmmmffffffiuiuhjuughbghhiuijghghghhhfhhghghguygtjuuujjjkkhh}
\begin{cases}
\frac{d\vec r}{ds}(s)=\vec h_0\left(\vec
r(s)\right)\\
\vec r(a)=\vec x_0,
\end{cases}
\end{equation}
then by
\er{MaxVacFullPPNmmmffffffiuiuhjuughbghhiuijghghghhhfhhghghguygtjuuujjjhy}
and
\er{MaxVacFullPPNmmmffffffiuiuhjuughbghhiuijghghghhhfhhghghguygtjuuujjjkkhh}
we have
\begin{equation}\label{MaxVacFullPPNmmmffffffiuiuhjuughbghhiuijghghghhhfhhghghguygtjuuujjjkhjhjhhh}
\frac{d}{ds}\left(A\left(\vec r(s)\right)\right)=\nabla_{\vec
x}A\left(\vec r(s)\right)\cdot\frac{d\vec r}{ds}(s)= \frac{1}{2}
g\left(\vec r(s)\right)A\left(\vec r(s)\right),
\end{equation}
where we denote
\begin{equation}\label{MaxVacFullPPNmmmffffffiuiuhjuughbghhiuijghghghhhfhhghghguygtjuuujjjkhjhjhffgghhpiuuu}
g(\vec x):=\Delta_{\vec x}S(\vec x)-\frac{1}{c^2_0(\vec
x)}\left(\left(\nabla^2_{\vec x}S(\vec x)\cdot\vec {\tilde u}(\vec
x)\right)\cdot\vec {\tilde u}(\vec x)\right).
\end{equation}
Then
\er{MaxVacFullPPNmmmffffffiuiuhjuughbghhiuijghghghhhfhhghghguygtjuuujjjkhjhjhhh}
implies
\begin{equation}\label{MaxVacFullPPNmmmffffffiuiuhjuughbghhiuijghghghhhfhhghghguygtjuuujjjkhjhjhffgghh}
A\left(\vec r(s)\right)=A\left(\vec
x_0\right)e^{\frac{1}{2}\int_a^{s}g\left(\vec
r(\tau)\right)d\tau}\quad\quad\forall s\in[a,b].
\end{equation}
In particular,
\begin{equation}\label{MaxVacFullPPNmmmffffffiuiuhjuughbghhiuijghghghhhfhhghghguygtjuuujjjkhjhjhffggiouuiiuhh}
A\left(\vec x_0\right)=0\;\;\text{implies}\;\; A\left(\vec
r(s)\right)=0\quad\forall s\in[a,b],\quad\text{and}\quad A\left(\vec
x_0\right)\neq 0\;\;\text{implies}\;\; A\left(\vec r(s)\right)\neq
0\quad\forall s\in[a,b].
\end{equation}
Therefore, by
\er{MaxVacFullPPNmmmffffffiuiuhjuughbghhiuijghghghhhfhhghghguygtjuuujjjkhjhjhffggiouuiiuhh}
we deduce that the curve that satisfies
\er{MaxVacFullPPNmmmffffffiuiuhjuughbghhiuijghghghhhfhhghghguygtjuuujjjkkhh}
coincides with the ray of light that passes through the point $\vec
x_0$. So
\er{MaxVacFullPPNmmmffffffiuiuhjuughbghhiuijghghghhhfhhghghguygtjuuujjjkkhh}
is the equation of a ray and the vector field $\vec h_0$ defined for
every $\vec x$ by
\er{MaxVacFullPPNmmmffffffiuiuhjuughbghhiuijghghghhhfhhghghguygtjuuujjjkyuuyhh}
is indeed the direction of propagation of the ray that passes
through point $\vec x$.
%
%
%
\begin{remark}\label{hjgjgghjj11}
As before in remark \ref{hjgjggh}, in contrast to the proof of
\er{MaxVacFullPPNmmmffffffiuiuhjuughbghhiuijghghghhjhjhhghyuyiyyujj}
or
\er{MaxVacFullPPNmmmffffffiuiuhjuughbghhiuijghghghhjhjhhghyuyiyyujjkhgggghgg},
we do not use the Geometric Optics approximation
\er{slochangGGaaffgfgjhjggh} in the proof of
\er{MaxVacFullPPNmmmffffffiuiuhjuughbghhiuijghghghhhfhhghghguygtjuuujj}
and
\er{MaxVacFullPPNmmmffffffiuiuhjuughbghhiuijghghghhhfhhghghguygtjuuujjjhy}.
So,
\er{MaxVacFullPPNmmmffffffiuiuhjuughbghhiuijghghghhhfhhghghguygtjuuujjjkhjhjhffggiouuiiuhh}
is still valid without assumption of the Geometric Optics
approximation \er{slochangGGaaffgfgjhjggh} and thus, the vector
field $\vec h_0$, defined by
\er{MaxVacFullPPNmmmffffffiuiuhjuughbghhiuijghghghhhfhhghghguygtjuuujjjkyuuyhh},
is the direction of the propagation of the ray that passes through
point $\vec x$ also in the general case. However, without assumption
of the Geometric Optics approximation we cannot derive
\er{MaxVacFullPPNmmmffffffiuiuhjuughbghhiuijghghghhjhjhhghyuyiyyujjkhgggghggutgg}
anymore.
\end{remark}
Next, by
\er{MaxVacFullPPNmmmffffffiuiuhjuughbghhiuijghghghhhfhhghghguygtjuuujjjkyuuyhh}
we have:
\begin{equation}\label{MaxVacFullPPNmmmffffffiuiuhjuughbghhiuijghghghhhfhhghghguygtjuuujjjkyuuyhhijhhhhjihhyjh}
\left(1-\frac{|\vec {\tilde
u}|^2}{c^2_0}\right)^{-1}\left(\frac{1}{|\vec {\tilde u}|}\vec
{\tilde u}\cdot\vec h_0\right)=\frac{c}{c^2_0}\left(1-\frac{|\vec
{\tilde u}|^2}{c^2_0}\right)^{-1}|\vec {\tilde
u}|-\left(\frac{1}{|\vec {\tilde u}|}\vec {\tilde
u}\cdot\nabla_{\vec x}S\right),
\end{equation}
and
\begin{equation}\label{MaxVacFullPPNmmmffffffiuiuhjuughbghhiuijghghghhhfhhghghguygtjuuujjjkyuuyhhijhhhhjjhjhj}
\vec h_0-\frac{1}{\left|\vec {\tilde u}\right|^2}\left(\vec {\tilde
u}\cdot\vec h_0\right)\vec {\tilde u}=-\left(\nabla_{\vec
x}S-\frac{1}{\left|\vec {\tilde u}\right|^2}\left(\vec {\tilde
u}\cdot\nabla_{\vec x}S\right)\vec {\tilde u}\right),
\end{equation}
On the other hand by
\er{MaxVacFullPPNmmmffffffiuiuhjuughbghhiuijghghghhjhjhhghyuyiyyujjkhgggghggutgg}
we have
\begin{equation}\label{MaxVacFullPPNmmmffffffiuiuhjuughbghhiuijghghghhjhjhhghyuyiyyujjkhgggghggutggjkkjkll}
\left|\vec h_0-\frac{1}{\left|\vec {\tilde u}\right|^2}\left(\vec
{\tilde u}\cdot\vec h_0\right)\vec {\tilde
u}\right|^2+\left(1-\frac{\left|\vec {\tilde
u}\right|^2}{c^2_0}\right)^{-1}\left|\frac{1}{\left|\vec {\tilde
u}\right|}\vec {\tilde u}\cdot\vec h_0\right|^2=
\frac{c^2}{c^2_0}\left(1-\frac{\left|\vec {\tilde
u}\right|^2}{c^2_0}\right)^{-1},
\end{equation}
Therefore if we consider a curve $\vec r(s):[a,b]\to\mathbb{R}^3$
with endpoints $\vec r(a)=N$ and $\vec r(b)=M$, then integrating the
square root of both sides of
\er{MaxVacFullPPNmmmffffffiuiuhjuughbghhiuijghghghhjhjhhghyuyiyyujjkhgggghggutggjkkjkll}
over the curve $\vec r(s)$ we deduce
\begin{multline}\label{MaxMedFullGGffgggyyojjhhjkhjyyiuhggjhhjhuyytytyuuytrrtghjtyuggyuighjuyioyyfgffhyuhhghzzrrhhkkkhh}
\int\limits_a^b\sqrt{\left|\vec h_0\left(\vec r(s)\right)-\left(\vec
{\tilde u}\left(\vec r(s)\right)\cdot\vec h_0\left(\vec
r(s)\right)\right)\frac{\vec {\tilde u}\left(\vec
r(s)\right)}{\left|\vec {\tilde u}\left(\vec
r(s)\right)\right|^2}\right|^2+\left(1-\frac{\left|\vec {\tilde
u}\left(\vec r(s)\right)\right|^2}{c^2_0\left(\vec
r(s)\right)}\right)^{-1}\left|\frac{\vec {\tilde u}\left(\vec
r(s)\right)}{\left|\vec {\tilde u}\left(\vec
r(s)\right)\right|}\cdot\vec h_0\left(\vec r(s)\right)\right|^2}
\cdot\\
\cdot\sqrt{\left|\vec r'(s)-\left(\vec {\tilde u}\left(\vec
r(s)\right)\cdot\vec r'(s)\right)\frac{\vec {\tilde u}\left(\vec
r(s)\right)}{\left|\vec {\tilde u}\left(\vec
r(s)\right)\right|^2}\right|^2+\left(1-\frac{\left|\vec {\tilde
u}\left(\vec r(s)\right)\right|^2}{c^2_0\left(\vec
r(s)\right)}\right)^{-1}\left|\frac{\vec {\tilde u}\left(\vec
r(s)\right)}{\left|\vec {\tilde u}\left(\vec
r(s)\right)\right|}\cdot\vec r'(s)\right|^2}ds
\\=\int\limits_a^b\frac{c}{c_0\left(\vec
r(s)\right)}\left(1-\frac{\left|\vec {\tilde u}\left(\vec
r(s)\right)\right|^2}{c^2_0\left(\vec
r(s)\right)}\right)^{-\frac{1}{2}}
\cdot\\
\cdot\sqrt{\left|\vec r'(s)-\left(\vec {\tilde u}\left(\vec
r(s)\right)\cdot\vec r'(s)\right)\frac{\vec {\tilde u}\left(\vec
r(s)\right)}{\left|\vec {\tilde u}\left(\vec
r(s)\right)\right|^2}\right|^2+\left(1-\frac{\left|\vec {\tilde
u}\left(\vec r(s)\right)\right|^2}{c^2_0\left(\vec
r(s)\right)}\right)^{-1}\left|\frac{\vec {\tilde u}\left(\vec
r(s)\right)}{\left|\vec {\tilde u}\left(\vec
r(s)\right)\right|}\cdot\vec r'(s)\right|^2}ds.
\end{multline}
Thus in particular, by inserting
\er{MaxVacFullPPNmmmffffffiuiuhjuughbghhiuijghghghhhfhhghghguygtjuuujjjkyuuyhhijhhhhjihhyjh}
and
\er{MaxVacFullPPNmmmffffffiuiuhjuughbghhiuijghghghhhfhhghghguygtjuuujjjkyuuyhhijhhhhjjhjhj}
into
\er{MaxMedFullGGffgggyyojjhhjkhjyyiuhggjhhjhuyytytyuuytrrtghjtyuggyuighjuyioyyfgffhyuhhghzzrrhhkkkhh}
and using inequality $\vec a\cdot\vec b\leq |\vec a||\vec b|$, we
deduce
\begin{multline}\label{MaxMedFullGGffgggyyojjhhjkhjyyiuhggjhhjhuyytytyuuytrrtghjtyuggyuighjuyioyyfgffhyuhhghzzrriuihhkkkhh}
\int_a^b\frac{c}{c^2_0\left(\vec
r(s)\right)}\left(1-\frac{\left|\vec {\tilde u}\left(\vec
r(s)\right)\right|^2}{c^2_0}\right)^{-1}\vec {\tilde u}\left(\vec
r(s)\right)\cdot\vec r'(s)ds-\int_a^b\nabla_{\vec x}S\left(\vec
r(s)\right)\cdot\vec r'(s)ds\\
\leq\int\limits_a^b\frac{c}{c_0\left(\vec
r(s)\right)}\left(1-\frac{\left|\vec {\tilde u}\left(\vec
r(s)\right)\right|^2}{c^2_0\left(\vec
r(s)\right)}\right)^{-\frac{1}{2}}
\cdot\\
\cdot\sqrt{\left|\vec r'(s)-\left(\vec {\tilde u}\left(\vec
r(s)\right)\cdot\vec r'(s)\right)\frac{\vec {\tilde u}\left(\vec
r(s)\right)}{\left|\vec {\tilde u}\left(\vec
r(s)\right)\right|^2}\right|^2+\left(1-\frac{\left|\vec {\tilde
u}\left(\vec r(s)\right)\right|^2}{c^2_0\left(\vec
r(s)\right)}\right)^{-1}\left|\frac{\vec {\tilde u}\left(\vec
r(s)\right)}{\left|\vec {\tilde u}\left(\vec
r(s)\right)\right|}\cdot\vec r'(s)\right|^2}ds,
\end{multline}
i.e.
\begin{multline}\label{MaxMedFullGGffgggyyojjhhjkhjyyiuhggjhhjhuyytytyuuytrrtghjtyuggyuighjuyioyyfgffhyuhhghzzrrkkijjhjhhkkkhh}
\left(-S(M)\right)-
\left(-S(N)\right)\leq-\int_a^b\frac{c}{c^2_0\left(\vec
r(s)\right)}\left(1-\frac{\left|\vec {\tilde u}\left(\vec
r(s)\right)\right|^2}{c^2_0\left(\vec r(s)\right)}\right)^{-1}\vec
{\tilde u}\left(\vec r(s)\right)\cdot\vec
r'(s)ds\\+\int\limits_a^b\frac{c}{c_0\left(\vec
r(s)\right)}\left(1-\frac{\left|\vec {\tilde u}\left(\vec
r(s)\right)\right|^2}{c^2_0\left(\vec
r(s)\right)}\right)^{-\frac{1}{2}}
\cdot\\
\cdot\sqrt{\left|\vec r'(s)-\left(\vec {\tilde u}\left(\vec
r(s)\right)\cdot\vec r'(s)\right)\frac{\vec {\tilde u}\left(\vec
r(s)\right)}{\left|\vec {\tilde u}\left(\vec
r(s)\right)\right|^2}\right|^2+\left(1-\frac{\left|\vec {\tilde
u}\left(\vec r(s)\right)\right|^2}{c^2_0\left(\vec
r(s)\right)}\right)^{-1}\left|\frac{\vec {\tilde u}\left(\vec
r(s)\right)}{\left|\vec {\tilde u}\left(\vec
r(s)\right)\right|}\cdot\vec r'(s)\right|^2}ds.
\end{multline}
Moreover, if
\begin{equation}\label{MaxMedFullGGffgggyyojjyugggjhhjzzrrhhkkkhh}
\frac{d\vec r}{ds}(s)=\sigma(s)\vec h_0\left(\vec r(s)\right),
\end{equation}
for some nonnegative scalar factor $\sigma=\sigma(s)$ then by
\er{MaxMedFullGGffgggyyojjyugggjhhjzzrrhhkkkhh}, exactly in the same
way as we get
\er{MaxMedFullGGffgggyyojjhhjkhjyyiuhggjhhjhuyytytyuuytrrtghjtyuggyuighjuyioyyfgffhyuhhghzzrriuihhkkkhh},
we rewrite
\er{MaxMedFullGGffgggyyojjhhjkhjyyiuhggjhhjhuyytytyuuytrrtghjtyuggyuighjuyioyyfgffhyuhhghzzrrhhkkkhh}
as
\begin{multline}\label{MaxMedFullGGffgggyyojjhhjkhjyyiuhggjhhjhuyytytyuuytrrtghjtyuggyuighjuyioyyfgffhyuhhghzzrrkkhhkkkhh}
\left(-S(M)\right)-
\left(-S(N)\right)=-\int_a^b\frac{c}{c^2_0\left(\vec
r(s)\right)}\left(1-\frac{\left|\vec {\tilde u}\left(\vec
r(s)\right)\right|^2}{c^2_0\left(\vec r(s)\right)}\right)^{-1}\vec
{\tilde u}\left(\vec r(s)\right)\cdot\vec
r'(s)ds\\+\int\limits_a^b\frac{c}{c_0\left(\vec
r(s)\right)}\left(1-\frac{\left|\vec {\tilde u}\left(\vec
r(s)\right)\right|^2}{c^2_0\left(\vec
r(s)\right)}\right)^{-\frac{1}{2}}
\cdot\\
\cdot\sqrt{\left|\vec r'(s)-\left(\vec {\tilde u}\left(\vec
r(s)\right)\cdot\vec r'(s)\right)\frac{\vec {\tilde u}\left(\vec
r(s)\right)}{\left|\vec {\tilde u}\left(\vec
r(s)\right)\right|^2}\right|^2+\left(1-\frac{\left|\vec {\tilde
u}\left(\vec r(s)\right)\right|^2}{c^2_0\left(\vec
r(s)\right)}\right)^{-1}\left|\frac{\vec {\tilde u}\left(\vec
r(s)\right)}{\left|\vec {\tilde u}\left(\vec
r(s)\right)\right|}\cdot\vec r'(s)\right|^2}ds.
\end{multline}
Thus, by comparing
\er{MaxVacFullPPNmmmffffffiuiuhjuughbghhiuijghghghhhfhhghghguygtjuuujjjkkhh}
with \er{MaxMedFullGGffgggyyojjyugggjhhjzzrrhhkkkhh} and using
\er{MaxMedFullGGffgggyyojjhhjkhjyyiuhggjhhjhuyytytyuuytrrtghjtyuggyuighjuyioyyfgffhyuhhghzzrrkkijjhjhhkkkhh}
and
\er{MaxMedFullGGffgggyyojjhhjkhjyyiuhggjhhjhuyytytyuuytrrtghjtyuggyuighjuyioyyfgffhyuhhghzzrrkkhhkkkhh},
we deduce that if we assume that the light travel from the point $N$
to the point $M$ across the curve $\vec {\tilde
r}(s):[a,b]\to\mathbb{R}^3$ such that $\vec{\tilde r(a)}=N$ and
$\vec {\tilde r}(b)=M$, then
\begin{multline}\label{MaxMedFullGGffgggyyojjhhjkhjyyiuhggjhhjhuyytytyuuytrrtghjtyuggyuighjuyioyyfgffhyuhhghzzrrkkhhkkkhhhhh}
\left(-S(M)\right)-
\left(-S(N)\right)=-\int_a^b\frac{c}{c^2_0\left(\vec{\tilde
r}(s)\right)}\left(1-\frac{\left|\vec {\tilde u}\left(\vec{\tilde
r}(s)\right)\right|^2}{c^2_0\left(\vec{\tilde
r}(s)\right)}\right)^{-1}\vec {\tilde u}\left(\vec{\tilde
r}(s)\right)\cdot\vec
r'(s)ds\\+\int\limits_a^b\frac{c}{c_0\left(\vec{\tilde
r}(s)\right)}\left(1-\frac{\left|\vec {\tilde u}\left(\vec{\tilde
r}(s)\right)\right|^2}{c^2_0\left(\vec{\tilde
r}(s)\right)}\right)^{-\frac{1}{2}}
\cdot\\
\cdot\sqrt{\left|\vec{\tilde r}'(s)-\left(\vec {\tilde
u}\left(\vec{\tilde r}(s)\right)\cdot\vec{\tilde
r}'(s)\right)\frac{\vec {\tilde u}\left(\vec{\tilde
r}(s)\right)}{\left|\vec {\tilde u}\left(\vec{\tilde
r}(s)\right)\right|^2}\right|^2+\left(1-\frac{\left|\vec {\tilde
u}\left(\vec{\tilde r}(s)\right)\right|^2}{c^2_0\left(\vec{\tilde
r}(s)\right)}\right)^{-1}\left|\frac{\vec {\tilde
u}\left(\vec{\tilde r}(s)\right)}{\left|\vec {\tilde
u}\left(\vec{\tilde r}(s)\right)\right|}\cdot\vec{\tilde
r}'(s)\right|^2}ds,
\end{multline}
and for every other curve $\vec r(s):[a,b]\to\mathbb{R}^3$ with
endpoints $\vec r(a)=N$ and $\vec r(b)=M$ we have
\begin{multline}\label{MaxMedFullGGffgggyyojjhhjkhjyyiuhggjhhjhuyytytyuuytrrtghjtyuggyuighjuyioyyfgffhyuhhghzzrrkkhhkkkhhhjhkjhhhh}
-\int_a^b\frac{c}{c^2_0\left(\vec
r(s)\right)}\left(1-\frac{\left|\vec {\tilde u}\left(\vec
r(s)\right)\right|^2}{c^2_0\left(\vec r(s)\right)}\right)^{-1}\vec
{\tilde u}\left(\vec r(s)\right)\cdot\vec
r'(s)ds\\+\int\limits_a^b\frac{c}{c_0\left(\vec
r(s)\right)}\left(1-\frac{\left|\vec {\tilde u}\left(\vec
r(s)\right)\right|^2}{c^2_0\left(\vec
r(s)\right)}\right)^{-\frac{1}{2}}
\cdot\\
\cdot\sqrt{\left|\vec r'(s)-\left(\vec {\tilde u}\left(\vec
r(s)\right)\cdot\vec r'(s)\right)\frac{\vec {\tilde u}\left(\vec
r(s)\right)}{\left|\vec {\tilde u}\left(\vec
r(s)\right)\right|^2}\right|^2+\left(1-\frac{\left|\vec {\tilde
u}\left(\vec r(s)\right)\right|^2}{c^2_0\left(\vec
r(s)\right)}\right)^{-1}\left|\frac{\vec {\tilde u}\left(\vec
r(s)\right)}{\left|\vec {\tilde u}\left(\vec
r(s)\right)\right|}\cdot\vec r'(s)\right|^2}ds\\ \geq
-\int_a^b\frac{c}{c^2_0\left(\vec{\tilde
r}(s)\right)}\left(1-\frac{\left|\vec {\tilde u}\left(\vec{\tilde
r}(s)\right)\right|^2}{c^2_0\left(\vec{\tilde
r}(s)\right)}\right)^{-1}\vec {\tilde u}\left(\vec{\tilde
r}(s)\right)\cdot\vec
r'(s)ds\\+\int\limits_a^b\frac{c}{c_0\left(\vec{\tilde
r}(s)\right)}\left(1-\frac{\left|\vec {\tilde u}\left(\vec{\tilde
r}(s)\right)\right|^2}{c^2_0\left(\vec{\tilde
r}(s)\right)}\right)^{-\frac{1}{2}}
\cdot\\
\cdot\sqrt{\left|\vec{\tilde r}'(s)-\left(\vec {\tilde
u}\left(\vec{\tilde r}(s)\right)\cdot\vec{\tilde
r}'(s)\right)\frac{\vec {\tilde u}\left(\vec{\tilde
r}(s)\right)}{\left|\vec {\tilde u}\left(\vec{\tilde
r}(s)\right)\right|^2}\right|^2+\left(1-\frac{\left|\vec {\tilde
u}\left(\vec{\tilde r}(s)\right)\right|^2}{c^2_0\left(\vec{\tilde
r}(s)\right)}\right)^{-1}\left|\frac{\vec {\tilde
u}\left(\vec{\tilde r}(s)\right)}{\left|\vec {\tilde
u}\left(\vec{\tilde r}(s)\right)\right|}\cdot\vec{\tilde
r}'(s)\right|^2}ds.
\end{multline}
So we obtain the following Fermat Principle:
\begin{proposition}\label{gughghfhh}
Assume Geometric Optics approximation.
Then
the light that travels from point $N$ to point $M$ chooses the path
$\vec r(s):[a,b]\to\mathbb{R}^3$ with endpoints $\vec r(a)=N$ and
$\vec r(b)=M$ which minimizes the quantity:
\begin{multline}\label{MaxMedFullGGffgggyyojjhhjkhjyyiuhggjhhjhuyytytyuuytrrtghjtyuggyuighjuyioyyfgffhyuhhghzzrrkkhhkkkhhhjhkjhhghhgghhh}
J\left(\vec r(\cdot)\right):= -\int_a^b\frac{1}{c}n^2\left(\vec
r(s)\right)\left(1-\frac{\left|\vec {\tilde u}\left(\vec
r(s)\right)\right|^2}{c^2_0\left(\vec r(s)\right)}\right)^{-1}\vec
{\tilde u}\left(\vec r(s)\right)\cdot\vec
r'(s)ds\\+\int\limits_a^bn\left(\vec
r(s)\right)\left(1-\frac{\left|\vec {\tilde u}\left(\vec
r(s)\right)\right|^2}{c^2_0\left(\vec r(s)\right)}\right)^{-1}
\cdot\\
\cdot\sqrt{\left(1-\frac{\left|\vec {\tilde u}\left(\vec
r(s)\right)\right|^2}{c^2_0\left(\vec r(s)\right)}\right)\left|\vec
r'(s)-\left(\vec {\tilde u}\left(\vec r(s)\right)\cdot\vec
r'(s)\right)\frac{\vec {\tilde u}\left(\vec r(s)\right)}{\left|\vec
{\tilde u}\left(\vec r(s)\right)\right|^2}\right|^2+\left|\frac{\vec
{\tilde u}\left(\vec r(s)\right)}{\left|\vec {\tilde u}\left(\vec
r(s)\right)\right|}\cdot\vec r'(s)\right|^2}ds,
\end{multline}
where we set refraction index:
\begin{equation}\label{MaxMedFullGGffgggyyojjhhjkhjyyiuhggjhhjhuyytytyuuytrrtghjtyuggyuighjuyioyyfgffhyuhhghzzrrkkhhkkkhhhjhkjhhghhgghiuiuhh}
n\left(\vec x\right):=\frac{c}{c_0\left(\vec x\right)}.
\end{equation}
\end{proposition}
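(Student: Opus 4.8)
The plan is to present the negative phase function $-S$ as an optical action whose increment between the two prescribed endpoints $N$ and $M$ is independent of the path, and then to show that this increment equals the functional $J(\vec r(\cdot))$ exactly along a genuine light beam while bounding $J$ from below along every competing curve. Since $\left(-S(M)\right)-\left(-S(N)\right)$ is a fixed number once $N$ and $M$ are fixed, the minimization property of $J$ is a formal consequence of these two facts; the whole argument is thus a variational reformulation of the Eikonal equation, with the Cauchy--Schwarz inequality supplying the lower bound.

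First I would recall the two basic equations of the Geometric Optics regime that are already derived above: the anisotropic Eikonal equation \er{MaxVacFullPPNmmmffffffiuiuhjuughbghhiuijghghghhjhjhhghyuyiyyujjkhgggghggutgg} and the beam--propagation equation \er{MaxVacFullPPNmmmffffffiuiuhjuughbghhiuijghghghhhfhhghghguygtjuuujjjhy}, the latter written in terms of the direction field $\vec h$ of \er{MaxVacFullPPNmmmffffffiuiuhjuughbghhiuijghghghhhfhhghghguygtjuuujjjkyuuyhh}. I would then certify that the integral curves of $\vec h$, i.e. the solutions of \er{MaxVacFullPPNmmmffffffiuiuhjuughbghhiuijghghghhhfhhghghguygtjuuujjjkkhh}, really are the light beams: along such a curve the amplitude $A$ satisfies the first-order transport identity \er{MaxVacFullPPNmmmffffffiuiuhjuughbghhiuijghghghhhfhhghghguygtjuuujjjkhjhjhhh}, which integrates to the exponential formula \er{MaxVacFullPPNmmmffffffiuiuhjuughbghhiuijghghghhhfhhghghguygtjuuujjjkhjhjhffgghh}, so that $A$ vanishes identically on the curve iff it vanishes at the starting point.

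Second I would recast the Eikonal relation in a weighted quadratic form tailored to the distinguished direction $\vec{\tilde u}$. Decomposing both $\vec h$ and $\nabla_{\vec x}S$ into their components parallel and orthogonal to $\vec{\tilde u}$ through the identities \er{MaxVacFullPPNmmmffffffiuiuhjuughbghhiuijghghghhhfhhghghguygtjuuujjjkyuuyhhijhhhhjihhyjh} and \er{MaxVacFullPPNmmmffffffiuiuhjuughbghhiuijghghghhhfhhghghguygtjuuujjjkyuuyhhijhhhhjjhjhj}, the Eikonal equation takes the symmetric shape \er{MaxVacFullPPNmmmffffffiuiuhjuughbghhiuijghghghhjhjhhghyuyiyyujjkhgggghggutggjkkjkll}, in which the orthogonal part together with the rescaled parallel part (carrying the weight $(1-|\vec{\tilde u}|^2/c^2_0)^{-1}$) has squared length $(c^2/c^2_0)(1-|\vec{\tilde u}|^2/c^2_0)^{-1}$. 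Taking square roots and integrating along an arbitrary curve $\vec r(s)$ joining $N$ to $M$ produces the integrated identity \er{MaxMedFullGGffgggyyojjhhjkhjyyiuhggjhhjhuyytytyuuytrrtghjtyuggyuighjuyioyyfgffhyuhhghzzrrhhkkkhh}, in which the left-hand integrand is the product of two weighted norms evaluated at $\vec h$ and $\vec r'$ respectively.

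The hard part, and essentially the only genuine computation, will be the anisotropic Cauchy--Schwarz bookkeeping: I must check that, once \er{MaxVacFullPPNmmmffffffiuiuhjuughbghhiuijghghghhhfhhghghguygtjuuujjjkyuuyhhijhhhhjihhyjh} and \er{MaxVacFullPPNmmmffffffiuiuhjuughbghhiuijghghghhhfhhghghguygtjuuujjjkyuuyhhijhhhhjjhjhj} are inserted, the weighted pairing of $\vec h$ with $\vec r'$ telescopes exactly into $-\int_a^b\frac1c n^2(1-|\vec{\tilde u}|^2/c^2_0)^{-1}\,\vec{\tilde u}\cdot\vec r'\,ds$ plus the increment of $-S$, so that the estimate $\left(-S(M)\right)-\left(-S(N)\right)\le J(\vec r(\cdot))$ reproduces the first term of the functional in \er{MaxMedFullGGffgggyyojjhhjkhjyyiuhggjhhjhuyytytyuuytrrtghjtyuggyuighjuyioyyfgffhyuhhghzzrrkkhhkkkhhhjhkjhhghhgghhh}; here one uses $\vec a\cdot\vec b\le|\vec a|\,|\vec b|$ applied within the weighted metric, with equality precisely when $\vec r'$ is a nonnegative multiple of $\vec h$, i.e. when $\vec r$ solves the beam equation \er{MaxVacFullPPNmmmffffffiuiuhjuughbghhiuijghghghhhfhhghghguygtjuuujjjkkhh}. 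Combining the equality case \er{MaxMedFullGGffgggyyojjhhjkhjyyiuhggjhhjhuyytytyuuytrrtghjtyuggyuighjuyioyyfgffhyuhhghzzrrkkhhkkkhh} along the true beam with the path-independence of $\left(-S(M)\right)-\left(-S(N)\right)$ then yields both that $J$ is minimized by the actual light ray and the role of the refraction index $n=c/c_0$ of \er{MaxMedFullGGffgggyyojjhhjkhjyyiuhggjhhjhuyytytyuuytrrtghjtyuggyuighjuyioyyfgffhyuhhghzzrrkkhhkkkhhhjhkjhhghhgghiuiuhh}, completing the proof.
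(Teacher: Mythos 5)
Your proposal is correct and follows essentially the same route as the paper's own argument: the same direction field $\vec h$, the same decomposition of $\vec h$ and $\nabla_{\vec x}S$ into components parallel and orthogonal to $\vec{\tilde u}$, the same weighted form of the Eikonal equation, and the same weighted Cauchy--Schwarz step showing $\left(-S(M)\right)-\left(-S(N)\right)\le J\left(\vec r(\cdot)\right)$ for every competing curve, with equality exactly when $\vec r'$ is a nonnegative multiple of $\vec h$, i.e.\ along an actual beam. The paper executes precisely this telescoping computation by inserting the two decomposition identities into the integrated identity and invoking $\vec a\cdot\vec b\le|\vec a||\vec b|$, so no further comparison is needed.
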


\subsubsection{The case of the non-monochromatic wave or/and  moving domains of propagation of light}\label{mnGObnbn}
Next, assume that the wave is not monochromatic or/and the fields
$\vec {\tilde u}$ and $c_0$ depend on the time variable or/and we
consider the case of moving domains of propagation of light (in
particular moving surfaces of reflection/refraction). In other words
we can not assume
\er{MaxVacFullPPNmmmffffffiuiuhjuughbghhiuijghghghhjhjhhghyuyhhjhjihikjjjjkjljkl}
or
\er{MaxVacFullPPNmmmffffffiuiuhjuughbghhiuijghghghhjhjhhghyuyhhjhjihikjjj}
anymore. However, we do assume the rough Geometric Optics
approximation \er{slochangGGaaffgfgjhjggh11}. We also assume that
our medium has no dispersion. Then, due to
\er{MaxVacFullPPNmmmffffffiuiuhjuughbghhuiiujj} we have:
\begin{equation}\label{MaxVacFullPPNmmmffffffiuiuhjuughbghhuiiujjhkhkh}
U(\vec x,t)=A(\vec x,t)e^{ik_0S(\vec x,t)},
\end{equation}
and by \er{MaxVacFullPPNmmmffffffiuiuhjuughbghhuiiujjjjjjjjhhhjjj}
and \er{MaxVacFullPPNmmmffffffiuiuhjuughbghhuiiujjjjjjjjhhhjjjkk} we
have:
\begin{equation}\label{MaxVacFullPPNmmmffffffiuiuhjuughbghhuiiujjjjjjjjhhhjjjjkkjjkihhhj}
\left<\,\left|\frac{\partial S}{\partial t}\right|\,\right>=c,
\end{equation}
where the sign $\left<\cdot\right>$ means the spatial and temporal
averaging. Then, due to
\er{MaxVacFullPPNmmmffffffiuiuhjuughbghhiuijghghghhjhjhhghyuyjjjhhjhjff}
we have the Eikonal type equation:
\begin{equation}\label{MaxVacFullPPNmmmffffffiuiuhjuughbghhiuijghghghhjhjhhghyuyjjjhhjhjffkkl}
\frac{1}{c^2_0}\left(\frac{\partial S}{\partial t}+\vec {\tilde
u}\cdot\nabla_\vec x S\right)^2=\left|\nabla_\vec x S\right|^2,
\end{equation}
and we rewrite the equation of the propagation of the beam
\er{MaxVacFullPPNmmmffffffiuiuhjuughbghhiuijghghghhhfhhghghguygtjjjkkjjkkhh}:
\begin{multline}\label{MaxVacFullPPNmmmffffffiuiuhjuughbghhiuijghghghhhfhhghghguygtjjjkkjjkkhhklkl}
\left(\frac{\partial}{\partial
t}\left\{\frac{1}{c^2_0}\left(\frac{\partial S}{\partial t}+\vec
{\tilde u}\cdot\nabla_{\vec x}S\right)\right\}+\Div_{\vec x}
\left\{\frac{1}{c^2_0}\left(\frac{\partial S}{\partial t}+\vec
{\tilde u}\cdot\nabla_{\vec x} S\right)\vec {\tilde
u}\right\}-\left(\Delta_{\vec x}S\right)\right)A
\\+\frac{2}{c^2_0}\left(\frac{\partial S}{\partial
t}+\vec {\tilde u}\cdot\nabla_{\vec x}S\right)\left(\frac{\partial
A}{\partial t}+\vec {\tilde u}\cdot\nabla_{\vec x}A\right)
-2\nabla_{\vec x}S\cdot\nabla_{\vec x}A=0
\end{multline}
Then, as before, denoting
\begin{equation}\label{MaxVacFullPPNmmmffffffiuiuhjuughbghhiuijghghghhhgjgfghjdfjgddg11}
\vec h(\vec x,t):=\vec {\tilde u}(\vec x,t)-c^2_0(\vec
x,t)\left(\frac{\partial S}{\partial t}(\vec x,t)+\vec {\tilde
u}(\vec x,t)\cdot\nabla_{\vec x}S(\vec x,t)\right)^{-1}\nabla_{\vec
x}S(\vec x,t),
\end{equation}
and
\begin{multline}\label{MaxVacFullPPNmmmffffffiuiuhjuughbghhiuijghghghhhfhhghghguygtjuuujjjkhjhjh11kklkloiouu}
G\left(\vec x,t\right):=\\
\frac{c^2_0}{2\left(\frac{\partial S}{\partial t}+\vec {\tilde
u}\cdot\nabla_{\vec x}S\right)}\left(\Delta_{\vec
x}S-\left(\frac{\partial}{\partial
t}\left\{\frac{1}{c^2_0}\left(\frac{\partial S}{\partial t}+\vec
{\tilde u}\cdot\nabla_{\vec x}S\right)\right\}+\Div_{\vec x}
\left\{\frac{1}{c^2_0}\left(\frac{\partial S}{\partial t}+\vec
{\tilde u}\cdot\nabla_{\vec x} S\right)\vec {\tilde
u}\right\}\right)\right),
\end{multline}
by inserting
\er{MaxVacFullPPNmmmffffffiuiuhjuughbghhiuijghghghhhgjgfghjdfjgddg11}
and
\er{MaxVacFullPPNmmmffffffiuiuhjuughbghhiuijghghghhhfhhghghguygtjuuujjjkhjhjh11kklkloiouu}
into
\er{MaxVacFullPPNmmmffffffiuiuhjuughbghhiuijghghghhhfhhghghguygtjjjkkjjkkhhklkl}
we clearly have:
\begin{equation}\label{MaxVacFullPPNmmmffffffiuiuhjuughbghhiuijghghghhhfhhghghguygtjjjkkjjkkhhklklkjk}
\frac{\partial A}{\partial t}(\vec x,t)
+\vec h(\vec x,t)\cdot\nabla_{\vec x}A(\vec x,t)=
G(\vec x,t) A(\vec x,t)
.
\end{equation}
Next  consider a curve $\vec r(t):[t_0,b]\to\mathbb{R}^3$,
parameterized by the time variable $t$, defined as a solution of
ordinary differential equation:
\begin{equation}\label{MaxVacFullPPNmmmffffffiuiuhjuughbghhiuijghghghhhfhhghghguygtjuuujjjkk1133}
\begin{cases}
\frac{d\vec r}{dt}(t)=\vec h\left(\vec r(t),t\right)\\
\vec r(t_0)=\vec x_0,
\end{cases}
\end{equation}
here $\vec h$ was defined in
\er{MaxVacFullPPNmmmffffffiuiuhjuughbghhiuijghghghhhgjgfghjdfjgddg11}.
Note that the equations of the ray propagation of the form
\er{MaxVacFullPPNmmmffffffiuiuhjuughbghhiuijghghghhhfhhghghguygtjuuujjjkk1133}
are not always convenient since $\vec h$ in the right hand side of
\er{MaxVacFullPPNmmmffffffiuiuhjuughbghhiuijghghghhhfhhghghguygtjuuujjjkk11}
depend on the partial derivatives of the quantity $S$, which we do
not know apriory unless we solve
\er{MaxVacFullPPNmmmffffffiuiuhjuughbghhiuijghghghhjhjhhghyuyjjjhhjhjffkkl}.
In the following proposition we present the alternative form for the
equations of the ray propagation being the second-order ordinary
differential equations which dose not contain the quantity $S$ or
its partial derivatives.
\begin{proposition}\label{profghgjjh}
Consider a smooth solution of
\er{MaxVacFullPPNmmmffffffiuiuhjuughbghhiuijghghghhjhjhhghyuyjjjhhjhjffkkl}.
Next let $\vec r(t):[t_0,b]\to\mathbb{R}^3$ be any curve
parameterized by the time variable $t$, defined as a solution
\er{MaxVacFullPPNmmmffffffiuiuhjuughbghhiuijghghghhhfhhghghguygtjuuujjjkk1133}.
Then, $\vec r(t)$ satisfies the following second-order ordinary
differential equations of the Ray Propagation:
\begin{multline}\label{MaxVacFullPPNmmmffffffiuiuhjuughbghhiuijghghghhhfhhghghguygtjuuujjjkk11jjjggjjgtttjhhuhhjhjhgghghjhjggjugghgh}
\frac{d}{dt}\left(\frac{1}{c_0\left(\vec
r(t),t\right)}\left(\frac{d\vec r}{dt}(t)-\vec {\tilde u}\left(\vec
r(t),t\right)\right)\right)=
 -\nabla_\vec x c_0(\vec r,t)-\frac{1}{c_0(\vec r,t)}\left\{d_{\vec x}\vec {\tilde u}(\vec
r,t)\right\}^T\cdot\left(\frac{d\vec r}{dt}-\vec {\tilde
u}(\vec r,t)\right)\\
+\frac{1}{c^2_0(\vec r,t)}\left(\left(\frac{d\vec r}{dt}-\vec
{\tilde u}(\vec r,t)\right)\cdot\nabla_{\vec x}c_0(\vec
r,t)\right)\left(\frac{d\vec r}{dt}-\vec {\tilde
u}(\vec r,t)\right)\\
+\frac{1}{2c^3_0(\vec r,t)}\left(\left(\left(\left\{d_{\vec x}\vec
{\tilde u}(\vec r,t)\right\}^T+d_{\vec x}\vec {\tilde u}(\vec
r,t)\right)\cdot\left(\frac{d\vec r}{dt}-\vec {\tilde u}(\vec
r,t)\right)\right)\cdot \left(\frac{d\vec r}{dt}-\vec {\tilde
u}(\vec r,t)\right)\right)\left(\frac{d\vec r}{dt}-\vec {\tilde
u}(\vec r,t)\right).
\end{multline}
Moreover, the proper scalar quantity
$\tilde\omega(t):[t_0,b]\to\mathbb{R}$, defined by
\begin{equation}\label{MaxVacFullPPNmmmffffffiuiuhjuughbghhiuijghghghhhfhhghghguygtjuuujjjkk11jjjggjjgyhhjhhj}
\tilde\omega(t)=\frac{\partial S}{\partial t}\left(\vec
r(t),t\right)+\vec {\tilde u}\left(\vec
r(t),t\right)\cdot\nabla_{\vec x}S\left(\vec r(t),t\right)
\end{equation}
satisfies the following ordinary differential equation:
\begin{multline}\label{MaxVacFullPPNmmmffffffiuiuhjuughbghhiuijghghghhhfhhghghguygtjuuujjjkk11jjjggjjgkhhhhkjhkhkhhkkkhkkhhhhjggghjhj}
\frac{d\tilde\omega}{dt}(t)= \frac{1}{c_0(\vec
r,t)}\left(\frac{\partial c_0}{\partial t}(\vec r,t)+\vec {\tilde
u}(\vec r,t)\cdot\nabla_{\vec x}c_0(\vec
r,t)\right)\tilde\omega(t)\\-\frac{1}{2c^2_0(\vec
r,t)}\left(\left(\left(\left\{d_{\vec x}\vec {\tilde u}(\vec
r,t)\right\}^T+d_{\vec x}\vec {\tilde u}(\vec
r,t)\right)\cdot\left(\frac{d\vec r}{dt}(t)-\vec {\tilde u}(\vec
r,t)\right)\right)\cdot \left(\frac{d\vec r}{dt}(t)-\vec {\tilde
u}(\vec r,t)\right)\right)\tilde\omega(t).
\end{multline}
\end{proposition}
\begin{proof}
Consider $\vec r(t):[t_0,b]\to\mathbb{R}^3$ to be an arbitrary
solution of ordinary differential equation
\er{MaxVacFullPPNmmmffffffiuiuhjuughbghhiuijghghghhhfhhghghguygtjuuujjjkk11},
i.e.:
\begin{equation}\label{MaxVacFullPPNmmmffffffiuiuhjuughbghhiuijghghghhhfhhghghguygtjuuujjjkk11jjj}
\begin{cases}
\frac{d\vec r}{dt}(t)=\vec h\left(\vec r(t),t\right)\\
\vec r(t_0)=\vec x_0,
\end{cases}
\end{equation}
where $\vec h$ was defined in
\er{MaxVacFullPPNmmmffffffiuiuhjuughbghhiuijghghghhhgjgfghjdfjgddg11}.
In particular, by
\er{MaxVacFullPPNmmmffffffiuiuhjuughbghhiuijghghghhhfhhghghguygtjuuujjjkk11jjj}
and
\er{MaxVacFullPPNmmmffffffiuiuhjuughbghhiuijghghghhjhjhhghyuyjjjhhjhjffkklkljljjkhhjkjhghghghl;kl;;k;kkljjkjkll;l;}
we have
\begin{equation}\label{MaxVacFullPPNmmmffffffiuiuhjuughbghhiuijghghghhhfhhghghguygtjuuujjjkk11jjjggjjgjkljkuhu}
\left|\frac{d\vec r}{dt}(t)-\vec {\tilde u}(\vec
r,t)\right|^2=c^2_0(\vec r,t),
\end{equation}
and by
\er{MaxVacFullPPNmmmffffffiuiuhjuughbghhiuijghghghhhfhhghghguygtjuuujjjkk11jjj}
and
\er{MaxVacFullPPNmmmffffffiuiuhjuughbghhiuijghghghhhgjgfghjdfjgddg11}
we have
\begin{equation}\label{MaxVacFullPPNmmmffffffiuiuhjuughbghhiuijghghghhhfhhghghguygtjuuujjjkk11jjjggjjgyhhj}
\nabla_{\vec x}S(\vec r,t)=-\frac{1}{c^2_0(\vec
r,t)}\left(\frac{\partial S}{\partial t}(\vec r,t)+\vec {\tilde
u}(\vec r,t)\cdot\nabla_{\vec x}S(\vec r,t)\right)\left(\frac{d\vec
r}{dt}(t)-\vec {\tilde u}(\vec r,t)\right).
\end{equation}
Then, by
\er{MaxVacFullPPNmmmffffffiuiuhjuughbghhiuijghghghhhfhhghghguygtjuuujjjkk11jjjggjjgyhhj}
and
\er{MaxVacFullPPNmmmffffffiuiuhjuughbghhiuijghghghhhfhhghghguygtjuuujjjkk11jjjggjjgjkljkuhu}
obviously we have
\begin{equation}\label{MaxVacFullPPNmmmffffffiuiuhjuughbghhiuijghghghhhfhhghghguygtjuuujjjkk11jjjggjjg}
\frac{d\vec r}{dt}(t)-\vec {\tilde u}(\vec r,t)=-c^2_0(\vec
r,t)\left(\frac{\partial S}{\partial t}(\vec r,t)+\vec {\tilde
u}(\vec r,t)\cdot\nabla_{\vec x}S(\vec r,t)\right)^{-1}\nabla_{\vec
x}S(\vec r,t),
\end{equation}
\begin{equation}\label{MaxVacFullPPNmmmffffffiuiuhjuughbghhiuijghghghhhfhhghghguygtjuuujjjkk11jjjggjjgpkkkjkjkjk}
\frac{\partial S}{\partial t}(\vec r,t)+\vec {\tilde u}(\vec
r,t)\cdot\nabla_{\vec x}S(\vec r,t)=-\left(\frac{d\vec r}{dt}-\vec
{\tilde u}(\vec r,t)\right)\cdot\nabla_{\vec x}S(\vec r,t),
\end{equation}
and thus,
\begin{equation}\label{MaxVacFullPPNmmmffffffiuiuhjuughbghhiuijghghghhhfhhghghguygtjuuujjjkk11jjjggjjg22}
\frac{d\vec r}{dt}(t)-\vec {\tilde u}(\vec r,t)=c^2_0(\vec
r,t)\left(\left(\frac{d\vec r}{dt}-\vec {\tilde u}(\vec
r,t)\right)\cdot\nabla_{\vec x}S(\vec r,t)\right)^{-1}\nabla_{\vec
x}S(\vec r,t).
\end{equation}
On the other hand by
\er{MaxVacFullPPNmmmffffffiuiuhjuughbghhiuijghghghhjhjhhghyuyjjjhhjhjffkkl}
and the Chain Rule we have
\begin{equation}\label{MaxVacFullPPNmmmffffffiuiuhjuughbghhiuijghghghhjhjhhghyuyjjjhhjhjffkkljjjmn,n33}
\frac{1}{c^2_0}\left(\frac{\partial S}{\partial t}+\vec {\tilde
u}\cdot\nabla_\vec x S\right)\nabla_{\vec x}\left(\frac{\partial
S}{\partial t}+\vec {\tilde u}\cdot\nabla_\vec x
S\right)-\frac{1}{c^3_0}\left(\frac{\partial S}{\partial t}+\vec
{\tilde u}\cdot\nabla_\vec x S\right)^2\nabla_\vec x
c_0=\nabla^2_{\vec x}S(\vec r,t)\cdot\nabla_{\vec x}S(\vec r,t).
\end{equation}
Thus,
\begin{equation}\label{MaxVacFullPPNmmmffffffiuiuhjuughbghhiuijghghghhjhjhhghyuyjjjhhjhjffkkljjjmn,n44}
\left(\frac{\partial S}{\partial t}+\vec {\tilde u}\cdot\nabla_\vec
x S\right)\nabla_{\vec x}\left(\frac{\partial S}{\partial t}+\vec
{\tilde u}\cdot\nabla_\vec x
S\right)-\frac{1}{c_0}\left(\frac{\partial S}{\partial t}+\vec
{\tilde u}\cdot\nabla_\vec x S\right)^2\nabla_\vec x
c_0=c^2_0\nabla^2_{\vec x}S(\vec r,t)\cdot\nabla_{\vec x}S(\vec
r,t).
\end{equation}
So,
\begin{multline}\label{MaxVacFullPPNmmmffffffiuiuhjuughbghhiuijghghghhjhjhhghyuyjjjhhjhjffkkljjjmn,n77}
\frac{\partial}{\partial t}\left(\nabla_{\vec
x}S\right)+\nabla^2_{\vec x}S\cdot\vec {\tilde u}+\left\{d_{\vec
x}\vec {\tilde u}\right\}^T\cdot\nabla_{\vec
x}S-\frac{1}{c_0}\left(\frac{\partial S}{\partial t}+\vec {\tilde
u}\cdot\nabla_\vec x S\right)\nabla_\vec x
c_0\\=c^2_0\left(\frac{\partial S}{\partial t}+\vec {\tilde
u}\cdot\nabla_\vec x S\right)^{-1}\nabla^2_{\vec x}S(\vec
r,t)\cdot\nabla_{\vec x}S(\vec r,t).
\end{multline}
Therefore,
\begin{multline}\label{MaxVacFullPPNmmmffffffiuiuhjuughbghhiuijghghghhjhjhhghyuyjjjhhjhjffkkljjjmn,n}
\frac{\partial}{\partial t}\left(\nabla_{\vec
x}S\right)+\nabla^2_{\vec x}S\cdot\vec {\tilde u}-
c^2_0\left(\frac{\partial S}{\partial t}+\vec {\tilde
u}\cdot\nabla_\vec x S\right)^{-1}\nabla^2_{\vec x}S(\vec
r,t)\cdot\nabla_{\vec x}S(\vec r,t)=\\
\frac{1}{c_0}\left(\frac{\partial S}{\partial t}+\vec {\tilde
u}\cdot\nabla_\vec x S\right)\nabla_\vec x c_0 -\left\{d_{\vec
x}\vec {\tilde u}\right\}^T\cdot\nabla_{\vec x}S.
\end{multline}
Next, by
\er{MaxVacFullPPNmmmffffffiuiuhjuughbghhiuijghghghhjhjhhghyuyjjjhhjhjffkkljjjmn,n}
and
\er{MaxVacFullPPNmmmffffffiuiuhjuughbghhiuijghghghhhfhhghghguygtjuuujjjkk11jjjggjjg}
we infer
\begin{multline}\label{MaxVacFullPPNmmmffffffiuiuhjuughbghhiuijghghghhhfhhghghguygtjuuujjjkk11jjjggjjgijkjuu}
\nabla^2_{\vec x}S(\vec r,t)\cdot\left(\frac{d\vec r}{dt}-\vec
{\tilde u}(\vec r,t)\right)+\frac{\partial}{\partial
t}\left(\nabla_{\vec x}S\right)(\vec r,t)+\nabla^2_{\vec x}S(\vec
r,t)\cdot\vec {\tilde u}(\vec r,t)=\\ \frac{1}{c_0(\vec
r,t)}\left(\frac{\partial S}{\partial t}(\vec r,t)+\vec {\tilde
u}(\vec r,t)\cdot\nabla_\vec x S(\vec r,t)\right)\nabla_\vec x
c_0(\vec r,t) -\left\{d_{\vec x}\vec {\tilde u}(\vec
r,t)\right\}^T\cdot\nabla_{\vec x}S(\vec r,t).
\end{multline}
On the other hand, by
\er{MaxVacFullPPNmmmffffffiuiuhjuughbghhiuijghghghhhfhhghghguygtjuuujjjkk11jjjggjjgpkkkjkjkjk}
and and the Chain Rule we have
\begin{multline}\label{MaxVacFullPPNmmmffffffiuiuhjuughbghhiuijghghghhhfhhghghguygtjuuujjjkk11jjjggjjgkhhhh}
\frac{d}{dt}\left(\frac{\partial S}{\partial t}\left(\vec
r(t),t\right)+\vec {\tilde u}\left(\vec
r(t),t\right)\cdot\nabla_{\vec x}S\left(\vec
r(t),t\right)\right)=-\frac{d}{dt}\left(\left(\frac{d\vec
r}{dt}(t)-\vec {\tilde u}\left(\vec
r(t),t\right)\right)\cdot\nabla_{\vec x}S\left(\vec
r(t),t\right)\right)=\\
-\frac{d}{dt}\left(\frac{d\vec r}{dt}(t)-\vec {\tilde u}\left(\vec
r(t),t\right)\right)\cdot\nabla_{\vec x}S(\vec r,t)
-\left(\frac{d\vec r}{dt}-\vec {\tilde u}(\vec
r,t)\right)\cdot\left(\nabla^2_{\vec x}S(\vec r,t)\cdot\frac{d\vec
r}{dt}+\frac{\partial}{\partial t}\left(\nabla_{\vec x}S\right)(\vec
r,t)\right)
\\=
-\frac{d}{dt}\left(\frac{d\vec r}{dt}(t)-\vec {\tilde u}\left(\vec
r(t),t\right)\right)\cdot\nabla_{\vec x}S(\vec r,t)
\\-\left(\frac{d\vec r}{dt}-\vec {\tilde u}(\vec
r,t)\right)\cdot\left(\nabla^2_{\vec x}S(\vec
r,t)\cdot\left(\frac{d\vec r}{dt}-\vec {\tilde u}(\vec
r,t)\right)+\frac{\partial}{\partial t}\left(\nabla_{\vec
x}S\right)(\vec r,t)+\nabla^2_{\vec x}S(\vec r,t)\cdot\vec {\tilde
u}(\vec r,t)\right).
\end{multline}
Thus, inserting
\er{MaxVacFullPPNmmmffffffiuiuhjuughbghhiuijghghghhhfhhghghguygtjuuujjjkk11jjjggjjgyhhj}
and
\er{MaxVacFullPPNmmmffffffiuiuhjuughbghhiuijghghghhhfhhghghguygtjuuujjjkk11jjjggjjgijkjuu}
into
\er{MaxVacFullPPNmmmffffffiuiuhjuughbghhiuijghghghhhfhhghghguygtjuuujjjkk11jjjggjjgkhhhh}
we deduce
\begin{multline}\label{MaxVacFullPPNmmmffffffiuiuhjuughbghhiuijghghghhhfhhghghguygtjuuujjjkk11jjjggjjgkhhhhkjhkh}
\frac{d}{dt}\left(\frac{\partial S}{\partial t}\left(\vec
r(t),t\right)+\vec {\tilde u}\left(\vec
r(t),t\right)\cdot\nabla_{\vec x}S\left(\vec r(t),t\right)\right)
\\=
\frac{1}{c^2_0(\vec r,t)}\left(\frac{\partial S}{\partial t}(\vec
r,t)+\vec {\tilde u}(\vec r,t)\cdot\nabla_{\vec x}S(\vec
r,t)\right)\left(\frac{d\vec r}{dt}-\vec {\tilde u}(\vec
r,t)\right)\cdot\frac{d}{dt}\left(\frac{d\vec r}{dt}(t)-\vec {\tilde
u}\left(\vec r(t),t\right)\right) \\-\left(\frac{d\vec r}{dt}-\vec
{\tilde u}(\vec r,t)\right)\cdot\left(\frac{1}{c_0(\vec
r,t)}\left(\frac{\partial S}{\partial t}(\vec r,t)+\vec {\tilde
u}(\vec r,t)\cdot\nabla_\vec x S(\vec r,t)\right)\nabla_\vec x
c_0(\vec r,t) -\left\{d_{\vec x}\vec {\tilde u}(\vec
r,t)\right\}^T\cdot\nabla_{\vec x}S(\vec r,t)\right)
\\=
\frac{1}{2c^2_0(\vec r,t)}\left(\frac{\partial S}{\partial t}(\vec
r,t)+\vec {\tilde u}(\vec r,t)\cdot\nabla_{\vec x}S(\vec
r,t)\right)\frac{d}{dt}\left(\left|\frac{d\vec r}{dt}(t)-\vec
{\tilde u}\left(\vec r(t),t\right)\right|^2\right)
\\-\left(\frac{d\vec r}{dt}-\vec {\tilde u}(\vec
r,t)\right)\cdot\left(\frac{1}{c_0(\vec r,t)}\left(\frac{\partial
S}{\partial t}(\vec r,t)+\vec {\tilde u}(\vec r,t)\cdot\nabla_\vec x
S(\vec r,t)\right)\nabla_\vec x c_0(\vec r,t) -\left\{d_{\vec x}\vec
{\tilde u}(\vec r,t)\right\}^T\cdot\nabla_{\vec x}S(\vec
r,t)\right).
\end{multline}
Thus, inserting
\er{MaxVacFullPPNmmmffffffiuiuhjuughbghhiuijghghghhhfhhghghguygtjuuujjjkk11jjjggjjgjkljkuhu}
into
\er{MaxVacFullPPNmmmffffffiuiuhjuughbghhiuijghghghhhfhhghghguygtjuuujjjkk11jjjggjjgkhhhhkjhkh}
we deduce
\begin{multline}\label{MaxVacFullPPNmmmffffffiuiuhjuughbghhiuijghghghhhfhhghghguygtjuuujjjkk11jjjggjjgkhhhhkjhkhkhhkkkhk}
\frac{d}{dt}\left(\frac{\partial S}{\partial t}\left(\vec
r(t),t\right)+\vec {\tilde u}\left(\vec
r(t),t\right)\cdot\nabla_{\vec x}S\left(\vec r(t),t\right)\right)
\\=
\frac{1}{2c^2_0(\vec r,t)}\left(\frac{\partial S}{\partial t}(\vec
r,t)+\vec {\tilde u}(\vec r,t)\cdot\nabla_{\vec x}S(\vec
r,t)\right)\frac{d}{dt}\left(c^2_0\left(\vec r(t),t\right)\right)
\\-\left(\frac{d\vec r}{dt}-\vec {\tilde u}(\vec
r,t)\right)\cdot\left(\frac{1}{c_0(\vec r,t)}\left(\frac{\partial
S}{\partial t}(\vec r,t)+\vec {\tilde u}(\vec r,t)\cdot\nabla_\vec x
S(\vec r,t)\right)\nabla_\vec x c_0(\vec r,t) -\left\{d_{\vec x}\vec
{\tilde u}(\vec r,t)\right\}^T\cdot\nabla_{\vec x}S(\vec
r,t)\right)=
\\
\frac{1}{c_0(\vec r,t)}\left(\frac{\partial S}{\partial t}(\vec
r,t)+\vec {\tilde u}(\vec r,t)\cdot\nabla_{\vec x}S(\vec
r,t)\right)\left(\nabla_{\vec x}c_0(\vec r,t)\cdot\left(\frac{d\vec
r}{dt}-\vec {\tilde u}(\vec r,t)\right)+\frac{\partial c_0}{\partial
t}(\vec r,t)+\vec {\tilde u}(\vec r,t)\cdot\nabla_{\vec x}c_0(\vec
r,t)\right)
\\-\left(\frac{d\vec r}{dt}-\vec {\tilde u}(\vec
r,t)\right)\cdot\left(\frac{1}{c_0(\vec r,t)}\left(\frac{\partial
S}{\partial t}(\vec r,t)+\vec {\tilde u}(\vec r,t)\cdot\nabla_\vec x
S(\vec r,t)\right)\nabla_\vec x c_0(\vec r,t) -\left\{d_{\vec x}\vec
{\tilde u}(\vec r,t)\right\}^T\cdot\nabla_{\vec x}S(\vec
r,t)\right).
\end{multline}
Then inserting
\er{MaxVacFullPPNmmmffffffiuiuhjuughbghhiuijghghghhhfhhghghguygtjuuujjjkk11jjjggjjgyhhj}
into
\er{MaxVacFullPPNmmmffffffiuiuhjuughbghhiuijghghghhhfhhghghguygtjuuujjjkk11jjjggjjgkhhhhkjhkhkhhkkkhk}
we infer
\begin{multline}\label{MaxVacFullPPNmmmffffffiuiuhjuughbghhiuijghghghhhfhhghghguygtjuuujjjkk11jjjggjjgkhhhhkjhkhkhhkkkhkkhhhh}
\frac{d}{dt}\left(\frac{\partial S}{\partial t}\left(\vec
r(t),t\right)+\vec {\tilde u}\left(\vec
r(t),t\right)\cdot\nabla_{\vec x}S\left(\vec r(t),t\right)\right)=
\\
\frac{1}{c_0(\vec r,t)}\left(\frac{\partial S}{\partial t}(\vec
r,t)+\vec {\tilde u}(\vec r,t)\cdot\nabla_{\vec x}S(\vec
r,t)\right)\left(\nabla_{\vec x}c_0(\vec r,t)\cdot\left(\frac{d\vec
r}{dt}-\vec {\tilde u}(\vec r,t)\right)+\frac{\partial c_0}{\partial
t}(\vec r,t)+\vec {\tilde u}(\vec r,t)\cdot\nabla_{\vec x}c_0(\vec
r,t)\right)
\\-\frac{1}{c^2_0(\vec
r,t)}\left(\frac{\partial S}{\partial t}(\vec r,t)+\vec {\tilde
u}(\vec r,t)\cdot\nabla_\vec x S(\vec r,t)\right)\times\\
\times\left(\frac{d\vec r}{dt}-\vec {\tilde u}(\vec
r,t)\right)\cdot\left(c_0(\vec r,t)\nabla_\vec x c_0(\vec r,t)
+\left\{d_{\vec x}\vec {\tilde u}(\vec
r,t)\right\}^T\cdot\left(\frac{d\vec r}{dt}(t)-\vec {\tilde u}(\vec
r,t)\right)\right).
\end{multline}
So
\begin{multline}\label{MaxVacFullPPNmmmffffffiuiuhjuughbghhiuijghghghhhfhhghghguygtjuuujjjkk11jjjggjjgkhhhhkjhkhkhhkkkhkkhhhhjggg}
\frac{d}{dt}\left(\frac{\partial S}{\partial t}\left(\vec
r(t),t\right)+\vec {\tilde u}\left(\vec
r(t),t\right)\cdot\nabla_{\vec x}S\left(\vec r(t),t\right)\right)=
\\
\frac{1}{c_0(\vec r,t)}\left(\frac{\partial S}{\partial t}(\vec
r,t)+\vec {\tilde u}(\vec r,t)\cdot\nabla_{\vec x}S(\vec
r,t)\right)\left(\frac{\partial c_0}{\partial t}(\vec r,t)+\vec
{\tilde u}(\vec r,t)\cdot\nabla_{\vec x}c_0(\vec
r,t)\right)\\-\frac{1}{c^2_0(\vec r,t)}\left(\frac{\partial
S}{\partial t}(\vec r,t)+\vec {\tilde u}(\vec r,t)\cdot\nabla_{\vec
x}S(\vec r,t)\right)\left(\left(\left\{d_{\vec x}\vec {\tilde
u}(\vec r,t)\right\}^T\cdot\left(\frac{d\vec r}{dt}-\vec {\tilde
u}(\vec r,t)\right)\right)\cdot \left(\frac{d\vec r}{dt}-\vec
{\tilde u}(\vec r,t)\right)\right)
\\=
\frac{1}{c_0(\vec r,t)}\left(\frac{\partial S}{\partial t}(\vec
r,t)+\vec {\tilde u}(\vec r,t)\cdot\nabla_{\vec x}S(\vec
r,t)\right)\left(\frac{\partial c_0}{\partial t}(\vec r,t)+\vec
{\tilde u}(\vec r,t)\cdot\nabla_{\vec x}c_0(\vec
r,t)\right)\\-\frac{1}{2c^2_0(\vec r,t)}\left(\frac{\partial
S}{\partial t}(\vec r,t)+\vec {\tilde u}(\vec r,t)\cdot\nabla_{\vec
x}S(\vec r,t)\right)\times\\ \times\left(\left(\left(\left\{d_{\vec
x}\vec {\tilde u}(\vec r,t)\right\}^T+d_{\vec x}\vec {\tilde u}(\vec
r,t)\right)\cdot\left(\frac{d\vec r}{dt}-\vec {\tilde u}(\vec
r,t)\right)\right)\cdot \left(\frac{d\vec r}{dt}-\vec {\tilde
u}(\vec r,t)\right)\right),
\end{multline}
So we deduce
\er{MaxVacFullPPNmmmffffffiuiuhjuughbghhiuijghghghhhfhhghghguygtjuuujjjkk11jjjggjjgkhhhhkjhkhkhhkkkhkkhhhhjggghjhj}.
Moreover, we can rewrite it as:
\begin{multline}\label{MaxVacFullPPNmmmffffffiuiuhjuughbghhiuijghghghhhfhhghghguygtjuuujjjkk11jjjggjjgkhhhhkjhkhkhhkkkhkkhhhhjgggyuyy}
\left(\frac{\partial S}{\partial t}(\vec r,t)+\vec {\tilde u}(\vec
r,t)\cdot\nabla_{\vec x}S(\vec
r,t)\right)^{-1}\frac{d}{dt}\left(\frac{\partial S}{\partial
t}\left(\vec r(t),t\right)+\vec {\tilde u}\left(\vec
r(t),t\right)\cdot\nabla_{\vec x}S\left(\vec r(t),t\right)\right)
\\=
\frac{1}{c_0(\vec r,t)}\left(\frac{\partial c_0}{\partial t}(\vec
r,t)+\vec {\tilde u}(\vec r,t)\cdot\nabla_{\vec x}c_0(\vec
r,t)\right)\\-\frac{1}{2c^2_0(\vec
r,t)}\left(\left(\left(\left\{d_{\vec x}\vec {\tilde u}(\vec
r,t)\right\}^T+d_{\vec x}\vec {\tilde u}(\vec
r,t)\right)\cdot\left(\frac{d\vec r}{dt}-\vec {\tilde u}(\vec
r,t)\right)\right)\cdot \left(\frac{d\vec r}{dt}-\vec {\tilde
u}(\vec r,t)\right)\right).
\end{multline}
On the other hand, by
\er{MaxVacFullPPNmmmffffffiuiuhjuughbghhiuijghghghhhfhhghghguygtjuuujjjkk11jjjggjjg}
we have:
\begin{multline}\label{MaxVacFullPPNmmmffffffiuiuhjuughbghhiuijghghghhhfhhghghguygtjuuujjjkk11jjjggjjgtttjhhu}
\frac{d}{dt}\left(\frac{1}{c^2_0\left(\vec
r(t),t\right)}\left(\frac{d\vec r}{dt}(t)-\vec {\tilde u}\left(\vec
r(t),t\right)\right)\right)=\\-\left(\frac{\partial S}{\partial
t}(\vec r,t)+\vec {\tilde u}(\vec r,t)\cdot\nabla_{\vec x}S(\vec
r,t)\right)^{-1}\left(\nabla^2_{\vec x}S(\vec r,t)\cdot\frac{d\vec
r}{dt}+\frac{\partial}{\partial t}\left(\nabla_{\vec x}S\right)(\vec r,t)\right)\\
+\left(\frac{\partial S}{\partial t}(\vec r,t)+\vec {\tilde u}(\vec
r,t)\cdot\nabla_{\vec x}S(\vec
r,t)\right)^{-2}\left(\frac{d}{dt}\left(\frac{\partial S}{\partial
t}\left(\vec r(t),t\right)+\vec {\tilde u}\left(\vec
r(t),t\right)\cdot\nabla_{\vec x}S\left(\vec
r(t),t\right)\right)\right)\nabla_{\vec x}S(\vec r,t)\\=
-\left(\frac{\partial S}{\partial t}(\vec r,t)+\vec {\tilde u}(\vec
r,t)\cdot\nabla_{\vec x}S(\vec r,t)\right)^{-1}\times\\
\times\left(\nabla^2_{\vec x}S(\vec r,t)\cdot\left(\frac{d\vec
r}{dt}-\vec {\tilde u}(\vec r,t)\right)+\frac{\partial}{\partial
t}\left(\nabla_{\vec x}S\right)(\vec r,t)+\nabla^2_{\vec x}S(\vec
r,t)\cdot\vec {\tilde u}(\vec
r,t)\right)\\
+\left(\frac{\partial S}{\partial t}(\vec r,t)+\vec {\tilde u}(\vec
r,t)\cdot\nabla_{\vec x}S(\vec
r,t)\right)^{-2}\left(\frac{d}{dt}\left(\frac{\partial S}{\partial
t}\left(\vec r(t),t\right)+\vec {\tilde u}\left(\vec
r(t),t\right)\cdot\nabla_{\vec x}S\left(\vec
r(t),t\right)\right)\right)\nabla_{\vec x}S(\vec r,t).
\end{multline}
Thus, by inserting
\er{MaxVacFullPPNmmmffffffiuiuhjuughbghhiuijghghghhhfhhghghguygtjuuujjjkk11jjjggjjgijkjuu}
into
\er{MaxVacFullPPNmmmffffffiuiuhjuughbghhiuijghghghhhfhhghghguygtjuuujjjkk11jjjggjjgtttjhhu}
we deduce
\begin{multline}\label{MaxVacFullPPNmmmffffffiuiuhjuughbghhiuijghghghhhfhhghghguygtjuuujjjkk11jjjggjjgtttjhhuhhjhj}
\frac{d}{dt}\left(\frac{1}{c^2_0\left(\vec
r(t),t\right)}\left(\frac{d\vec r}{dt}(t)-\vec {\tilde u}\left(\vec
r(t),t\right)\right)\right)= -\left(\frac{\partial S}{\partial
t}(\vec r,t)+\vec {\tilde u}(\vec
r,t)\cdot\nabla_{\vec x}S(\vec r,t)\right)^{-1}\times\\
\times\left(\frac{1}{c_0(\vec r,t)}\left(\frac{\partial S}{\partial
t}(\vec r,t)+\vec {\tilde u}(\vec r,t)\cdot\nabla_\vec x S(\vec
r,t)\right)\nabla_\vec x c_0(\vec r,t) -\left\{d_{\vec x}\vec
{\tilde u}(\vec
r,t)\right\}^T\cdot\nabla_{\vec x}S(\vec r,t)\right)\\
+\left(\frac{\partial S}{\partial t}(\vec r,t)+\vec {\tilde u}(\vec
r,t)\cdot\nabla_{\vec x}S(\vec
r,t)\right)^{-2}\left(\frac{d}{dt}\left(\frac{\partial S}{\partial
t}\left(\vec r(t),t\right)+\vec {\tilde u}\left(\vec
r(t),t\right)\cdot\nabla_{\vec x}S\left(\vec
r(t),t\right)\right)\right)\nabla_{\vec x}S(\vec r,t)\\=
 -\left(\frac{1}{c_0(\vec r,t)}\nabla_\vec x c_0(\vec r,t) -\left(\frac{\partial S}{\partial t}(\vec r,t)+\vec {\tilde
u}(\vec r,t)\cdot\nabla_{\vec x}S(\vec
r,t)\right)^{-1}\left\{d_{\vec x}\vec {\tilde u}(\vec
r,t)\right\}^T\cdot\nabla_{\vec x}S(\vec r,t)\right)\\
+\left(\frac{\partial S}{\partial t}(\vec r,t)+\vec {\tilde u}(\vec
r,t)\cdot\nabla_{\vec x}S(\vec
r,t)\right)^{-2}\left(\frac{d}{dt}\left(\frac{\partial S}{\partial
t}\left(\vec r(t),t\right)+\vec {\tilde u}\left(\vec
r(t),t\right)\cdot\nabla_{\vec x}S\left(\vec
r(t),t\right)\right)\right)\nabla_{\vec x}S(\vec r,t).
\end{multline}
Thus inserting
\er{MaxVacFullPPNmmmffffffiuiuhjuughbghhiuijghghghhhfhhghghguygtjuuujjjkk11jjjggjjgkhhhhkjhkhkhhkkkhkkhhhhjgggyuyy}
into
\er{MaxVacFullPPNmmmffffffiuiuhjuughbghhiuijghghghhhfhhghghguygtjuuujjjkk11jjjggjjgtttjhhuhhjhj}
gives
\begin{multline}\label{MaxVacFullPPNmmmffffffiuiuhjuughbghhiuijghghghhhfhhghghguygtjuuujjjkk11jjjggjjgtttjhhuhhjhjhgghg}
\frac{d}{dt}\left(\frac{1}{c^2_0\left(\vec
r(t),t\right)}\left(\frac{d\vec r}{dt}(t)-\vec {\tilde u}\left(\vec
r(t),t\right)\right)\right)=\\
 -\frac{1}{c_0(\vec r,t)}\nabla_\vec x c_0(\vec r,t) +\left(\frac{\partial S}{\partial t}(\vec r,t)+\vec {\tilde
u}(\vec r,t)\cdot\nabla_{\vec x}S(\vec
r,t)\right)^{-1}\left\{d_{\vec x}\vec {\tilde u}(\vec
r,t)\right\}^T\cdot\nabla_{\vec x}S(\vec r,t)\\
+\frac{1}{c_0(\vec r,t)}\left(\frac{\partial c_0}{\partial t}(\vec
r,t)+\vec {\tilde u}(\vec r,t)\cdot\nabla_{\vec x}c_0(\vec
r,t)\right)\left(\frac{\partial S}{\partial t}(\vec r,t)+\vec
{\tilde u}(\vec r,t)\cdot\nabla_{\vec x}S(\vec
r,t)\right)^{-1}\nabla_{\vec x}S(\vec
r,t)\\
-\frac{1}{2c^2_0(\vec r,t)}\left(\left(\left(\left\{d_{\vec x}\vec
{\tilde u}(\vec r,t)\right\}^T+d_{\vec x}\vec {\tilde u}(\vec
r,t)\right)\cdot\left(\frac{d\vec r}{dt}-\vec {\tilde u}(\vec
r,t)\right)\right)\cdot \left(\frac{d\vec r}{dt}-\vec {\tilde
u}(\vec r,t)\right)\right)\times\\ \times\left(\frac{\partial
S}{\partial t}(\vec r,t)+\vec {\tilde u}(\vec r,t)\cdot\nabla_{\vec
x}S(\vec r,t)\right)^{-1}\nabla_{\vec x}S(\vec r,t).
\end{multline}
Therefore, inserting
\er{MaxVacFullPPNmmmffffffiuiuhjuughbghhiuijghghghhhfhhghghguygtjuuujjjkk11jjjggjjg}
into
\er{MaxVacFullPPNmmmffffffiuiuhjuughbghhiuijghghghhhfhhghghguygtjuuujjjkk11jjjggjjgtttjhhuhhjhjhgghg}
we finally deduce
\begin{multline}\label{MaxVacFullPPNmmmffffffiuiuhjuughbghhiuijghghghhhfhhghghguygtjuuujjjkk11jjjggjjgtttjhhuhhjhjhgghghjhjggj}
\frac{d}{dt}\left(\frac{1}{c^2_0\left(\vec
r(t),t\right)}\left(\frac{d\vec r}{dt}(t)-\vec {\tilde u}\left(\vec
r(t),t\right)\right)\right)=
 -\frac{1}{c_0(\vec r,t)}\nabla_\vec x c_0(\vec r,t)-\frac{1}{c^2_0(\vec r,t)}\left\{d_{\vec x}\vec {\tilde u}(\vec
r,t)\right\}^T\cdot\left(\frac{d\vec r}{dt}-\vec {\tilde
u}(\vec r,t)\right)\\
-\frac{1}{c^3_0(\vec r,t)}\left(\frac{\partial c_0}{\partial t}(\vec
r,t)+\vec {\tilde u}(\vec r,t)\cdot\nabla_{\vec x}c_0(\vec
r,t)\right)\left(\frac{d\vec r}{dt}-\vec {\tilde
u}(\vec r,t)\right)\\
+\frac{1}{2c^4_0(\vec r,t)}\left(\left(\left(\left\{d_{\vec x}\vec
{\tilde u}(\vec r,t)\right\}^T+d_{\vec x}\vec {\tilde u}(\vec
r,t)\right)\cdot\left(\frac{d\vec r}{dt}-\vec {\tilde u}(\vec
r,t)\right)\right)\cdot \left(\frac{d\vec r}{dt}-\vec {\tilde
u}(\vec r,t)\right)\right)\left(\frac{d\vec r}{dt}-\vec {\tilde
u}(\vec r,t)\right),
\end{multline}
that we can finally rewrite in the alternative form of
\er{MaxVacFullPPNmmmffffffiuiuhjuughbghhiuijghghghhhfhhghghguygtjuuujjjkk11jjjggjjgtttjhhuhhjhjhgghghjhjggjugghgh}.
\end{proof}
Note that, by
\er{MaxVacFullPPNmmmffffffiuiuhjuughbghhiuijghghghhhfhhghghguygtjuuujjjkk11jjjggjjgtttjhhuhhjhjhgghghjhjggjugghgh}
we obviously deduce
\begin{multline}\label{MaxVacFullPPNmmmffffffiuiuhjuughbghhiuijghghghhhfhhghghguygtjuuujjjkk11jjjggjjgtttjhhuhhjhjhgghghjhjggjugghghgjjgjg}
\frac{1}{2}\frac{d}{dt}\left(\left|\frac{1}{c_0\left(\vec
r(t),t\right)}\left(\frac{d\vec r}{dt}(t)-\vec {\tilde u}\left(\vec
r(t),t\right)\right)\right|^2-1\right)=\\
\frac{1}{c_0(\vec r,t)}\left(\left(\frac{d\vec r}{dt}-\vec {\tilde
u}(\vec r,t)\right)\cdot\nabla_{\vec x}c_0(\vec
r,t)\right)\left(\left|\frac{1}{c_0\left(\vec
r(t),t\right)}\left(\frac{d\vec r}{dt}(t)-\vec {\tilde u}\left(\vec
r(t),t\right)\right)\right|^2-1\right)\\
+\frac{1}{2c^2_0(\vec r,t)}\left(\left(\left(\left\{d_{\vec x}\vec
{\tilde u}(\vec r,t)\right\}^T+d_{\vec x}\vec {\tilde u}(\vec
r,t)\right)\cdot\left(\frac{d\vec r}{dt}-\vec {\tilde u}(\vec
r,t)\right)\right)\cdot \left(\frac{d\vec r}{dt}-\vec {\tilde
u}(\vec r,t)\right)\right)\times\\
\times\left(\left|\frac{1}{c_0\left(\vec
r(t),t\right)}\left(\frac{d\vec r}{dt}(t)-\vec {\tilde u}\left(\vec
r(t),t\right)\right)\right|^2-1\right).
\end{multline}
Therefore, if $\vec r(t)$ satisfies
\er{MaxVacFullPPNmmmffffffiuiuhjuughbghhiuijghghghhhfhhghghguygtjuuujjjkk11jjjggjjgtttjhhuhhjhjhgghghjhjggjugghgh}
then by
\er{MaxVacFullPPNmmmffffffiuiuhjuughbghhiuijghghghhhfhhghghguygtjuuujjjkk11jjjggjjgtttjhhuhhjhjhgghghjhjggjugghghgjjgjg}
we obtain that the following equality for the initial instant of
time $t_0$:
\begin{equation*}
\left|\frac{1}{c_0\left(\vec r(t_0),t_0\right)}\left(\frac{d\vec
r}{dt}(t_0)-\vec {\tilde u}\left(\vec
r(t_0),t_0\right)\right)\right|^2-1=0,
\end{equation*}
implies
\begin{equation*}
\left|\frac{1}{c_0\left(\vec r(t),t\right)}\left(\frac{d\vec
r}{dt}(t)-\vec {\tilde u}\left(\vec
r(t),t\right)\right)\right|^2-1=0
\end{equation*}
for every instant of time. So we get
\er{MaxVacFullPPNmmmffffffiuiuhjuughbghhiuijghghghhhfhhghghguygtjuuujjjkk11jjjggjjgjkljkuhu},
i.e. we have a consistency with
\er{MaxVacFullPPNmmmffffffiuiuhjuughbghhiuijghghghhjhjhhghyuyjjjhhjhjffkklkljljjkhhjkjhghghghl;kl;;k;kkljjkjkll;l;}
and
\er{MaxVacFullPPNmmmffffffiuiuhjuughbghhiuijghghghhhfhhghghguygtjuuujjjkk11}.

Next note that, as before, we can easily deduce that equations of
the ray propagation either of the form
\er{MaxVacFullPPNmmmffffffiuiuhjuughbghhiuijghghghhhfhhghghguygtjuuujjjkk11}
or of the form
\er{MaxVacFullPPNmmmffffffiuiuhjuughbghhiuijghghghhhfhhghghguygtjuuujjjkk11jjjggjjgtttjhhuhhjhjhgghghjhjggjugghgh}
are invariant under the change of non-inertial cartesian coordinate
system. Moreover,
\er{MaxVacFullPPNmmmffffffiuiuhjuughbghhiuijghghghhhfhhghghguygtjuuujjjkk11jjjggjjgkhhhhkjhkhkhhkkkhkkhhhhjggghjhj}
is also invariant under the change of non-inertial cartesian
coordinate system.

Next consider a wave characterized by:
\begin{equation}\label{MaxVacFullPPNmmmffffffiuiuhjuughbghhuiiujjhhjjhjhhj11}
U(\vec x,t)=A(\vec x,t)e^{ik_0S(\vec x,t)},
\end{equation}
and, consistently with
\er{MaxVacFullPPNmmmffffffiuiuhjuughbghhiuijghghghhhgjgfghjdfjgddg11}
consider a velocity field of the wave:
\begin{equation}\label{MaxVacFullPPNmmmffffffiuiuhjuughbghhiuijghghghhhfhhghghguygtjuuujjjkyuuykjkjj11}
\vec h(\vec x,t):=\vec {\tilde u}(\vec x,t)-c^2_0(\vec
x,t)\left(\frac{\partial S}{\partial t}(\vec x,t)+\vec {\tilde
u}(\vec x,t)\cdot\nabla_{\vec x}S(\vec x,t)\right)^{-1}\nabla_{\vec
x}S(\vec x,t),
\end{equation}
Furthermore, assume that the wave we consider undergoes reflection
and/or refraction on the time-dependent surface $\mathcal{T}$
having the outcoming three-dimensional unit normal $\vec n(\vec
x,t)$ and the motion velocity field $\vec w_\mathcal{T}(\vec x,t)$,
separating two regions characterized respectively by $c_0=c^{(1)}_0$
and $\vec {\tilde u}=\vec {\tilde u}_1$ and by $c^{(2)}_0$ and $\vec
{\tilde u}_2$, with the formation of the reflected wave,
characterized by:
\begin{equation}\label{MaxVacFullPPNmmmffffffiuiuhjuughbghhuiiujjhhjjhjhhjugh11}
U_1(\vec x,t)=A_1(\vec x,t)e^{ik_0S_1(\vec x,t)}\,,
\end{equation}
and by the velocity field:
\begin{equation}\label{MaxVacFullPPNmmmffffffiuiuhjuughbghhiuijghghghhhfhhghghguygtjuuujjjkyuuykjkjjhjhhj11}
\vec h_1(\vec x)=\vec {\tilde u}_1(\vec x,t)-c^2_0(\vec
x,t)\left(\frac{\partial S_1}{\partial t}(\vec x,t)+\vec {\tilde
u}_1(\vec x,t)\cdot\nabla_{\vec x}S_1(\vec
x,t)\right)^{-1}\nabla_{\vec x}S_1(\vec x,t),
\end{equation}
and formation of the refracted wave characterized by:
\begin{equation}\label{MaxVacFullPPNmmmffffffiuiuhjuughbghhuiiujjhhjjhjhhj111}
U_2(\vec x,t)=A_2(\vec x,t)e^{ik_0S_2(\vec x,t)}\,,
\end{equation}
and by the velocity field:
\begin{equation}\label{MaxVacFullPPNmmmffffffiuiuhjuughbghhiuijghghghhhfhhghghguygtjuuujjjkyuuykjkjj111}
\vec h_2(\vec x)=\vec {\tilde u}_2(\vec x,t)-(c^{(2)}_0)^2(\vec
x,t)\left(\frac{\partial S_2}{\partial t}(\vec x,t)+\vec {\tilde
u}_2(\vec x,t)\cdot\nabla_{\vec x}S_2(\vec
x,t)\right)^{-1}\nabla_{\vec x}S_2(\vec x,t).
\end{equation}
Then the boundary conditions of $U$, $U_1$ and $U_2$ depend on the
physical meaning of these fields. However, one of the
\underline{necessary} conditions should be that
\begin{equation}\label{MaxMedFullGGffgggyyojjyugggjhhjiiuuiyuyuyyuyuzzhhkkk11}
S_1(\vec x,t)=S_2(\vec x,t)+C_2=S(\vec x,t)\quad\quad\forall\vec
x\in\mathcal{T},\;\forall t,
\end{equation}
where $C_2$ is a real constant. In particular
\er{MaxMedFullGGffgggyyojjyugggjhhjiiuuiyuyuyyuyuzzhhkkk11} implies
\begin{equation}\label{MaxMedFullGGffgggyyojjyugggjhhjiiuuiyuyuyyuyuzzhhkkkgdfg11}
\nabla_{\vec x} S_1-\left(\vec n\cdot \nabla_{\vec x} S_1\right)\vec
n
=\nabla_{\vec x} S_2-\left(\vec n\cdot \nabla_{\vec x}
S_2\right)\vec n=\nabla_{\vec x} S-\left(\vec n\cdot \nabla_{\vec x}
S\right)\vec n\quad\quad\forall\vec x\in\mathcal{T},\;\forall t.
\end{equation}
In particular, for every point on the surface $\mathcal{T}$ vectors
$\nabla_{\vec x} S_1$ and $\nabla_{\vec x} S_2$ lie in the plane
formed by vectors $\vec n$ and $\nabla_{\vec x} S$. Moreover, by
\er{MaxMedFullGGffgggyyojjyugggjhhjiiuuiyuyuyyuyuzzhhkkk11} we also
have
\begin{equation}\label{MaxMedFullGGffgggyyojjyugggjhhjiiuuiyuyuyyuyuzzhhkkkgdfghgghgh11kjhk11}
\frac{\partial S}{\partial t}+\vec w_\mathcal{T}\cdot\nabla_{\vec x}
S=\frac{\partial S_1}{\partial t}+\vec
w_\mathcal{T}\cdot\nabla_{\vec x} S_1=\frac{\partial S_2}{\partial
t}+\vec w_\mathcal{T}\cdot\nabla_{\vec x} S_2\quad\quad\forall\vec
x\in\mathcal{T},\;\forall t.
\end{equation}
Finally, by
\er{MaxVacFullPPNmmmffffffiuiuhjuughbghhiuijghghghhjhjhhghyuyjjjhhjhjffkklkljljjkhhjkjhghghghl;kl;;k;kkljjkjkll;l;}
we have:
\begin{equation}\label{MaxVacFullPPNmmmffffffiuiuhjuughbghhiuijghghghhjhjhhghyuyjjjhhjhjffkklkljljjkhhjkjhghghghl;kl;;k;kkljjkjkll;l;jijhhjkjkkj}
\left|\vec h-\vec{\tilde u}\right|^2=c^2_0\,,\quad\left|\vec
h_1-\vec{\tilde u}\right|^2=c^2_0\quad\text{and}\quad\left|\vec
h_2-\vec{\tilde u}_2\right|^2=(c^{(2)}_0)^2.
\end{equation}
In particular, by
\er{MaxMedFullGGffgggyyojjyugggjhhjiiuuiyuyuyyuyuzzhhkkkgdfg11} and
\er{MaxMedFullGGffgggyyojjyugggjhhjiiuuiyuyuyyuyuzzhhkkkgdfghgghgh11kjhk11}
we have:
\begin{multline}\label{MaxMedFullGGffgggyyojjyugggjhhjiiuuiyuyuyyuyuzzhhkkkgdfg11lkjjjkk}
\nabla_{\vec x} S_1-\frac{1}{c^2_0}\left(\frac{\partial
S_1}{\partial t}+\vec w_{\mathcal{T}}\cdot\nabla_\vec x
S_1\right)\left(\vec {\tilde u}-\vec
w_{\mathcal{T}}\right)-\left(\vec n\cdot \left(\nabla_{\vec x}
S_1-\frac{1}{c^2_0}\left(\frac{\partial S_1}{\partial t}+\vec
w_{\mathcal{T}}\cdot\nabla_\vec x S_1\right)\left(\vec {\tilde
u}-\vec w_{\mathcal{T}}\right)\right)\right)\vec n
\\ =\left(\nabla_{\vec x} S-\frac{1}{c^2_0}\left(\frac{\partial
S}{\partial t}+\vec w_{\mathcal{T}}\cdot\nabla_\vec x
S\right)\left(\vec {\tilde u}-\vec
w_{\mathcal{T}}\right)\right)-\left(\vec n\cdot \left(\nabla_{\vec
x} S-\frac{1}{c^2_0}\left(\frac{\partial S}{\partial t}+\vec
w_{\mathcal{T}}\cdot\nabla_\vec x S\right)\left(\vec {\tilde u}-\vec
w_{\mathcal{T}}\right)\right)\right)\vec n
\\=
\left(\nabla_{\vec x} S_2-\frac{1}{c_0^2}\left(\frac{\partial
S_2}{\partial t}+\vec w_{\mathcal{T}}\cdot\nabla_\vec x
S_2\right)\left(\vec {\tilde u}-\vec
w_{\mathcal{T}}\right)\right)\\-\left(\vec n\cdot \left(\nabla_{\vec
x} S_2-\frac{1}{c_0^2}\left(\frac{\partial S_2}{\partial t}+\vec
w_{\mathcal{T}}\cdot\nabla_\vec x S_2\right)\left(\vec {\tilde
u}-\vec w_{\mathcal{T}}\right)\right)\right)\vec n
\quad\quad\forall\vec x\in\mathcal{T},\;\forall t.
\end{multline}

Next, assume that the following approximation is valid on the both
sides of the surface $\mathcal{T}$:
\begin{equation}\label{ojhkkkpkkljouu11}
\frac{|\vec w_\mathcal{T}|^2}{c^2_0+(c^{(2)}_0)^2}\ll
1\,,\quad\frac{|\vec {\tilde u}|^2}{c^2_0}\ll
1\quad\text{and}\quad\frac{|\vec {\tilde u}_2|^2}{(c^{(2)}_0)^2}\ll
1.
\end{equation}
Then, by \er{ojhkkkpkkljouu11} we approximate
\er{MaxVacFullPPNmmmffffffiuiuhjuughbghhiuijghghghhjhjhhghyuyjjjhhjhjffkkl}
as:
\begin{equation}\label{MaxVacFullPPNmmmffffffiuiuhjuughbghhiuijghghghhjhjhhghyuyjjjhhjhjffkklkljljjkhhjkjhjjkjk}
\frac{1}{c^2_0}\left(\frac{\partial S}{\partial
t}\right)^2\approx\left|\nabla_\vec x
S-\frac{1}{c^2_0}\left(\frac{\partial S}{\partial t}+\vec {\tilde
u}\cdot\nabla_\vec x S\right)\vec {\tilde u}\right|^2.
\end{equation}
Thus by \er{ojhkkkpkkljouu11} we further approximate
\er{MaxVacFullPPNmmmffffffiuiuhjuughbghhiuijghghghhjhjhhghyuyjjjhhjhjffkklkljljjkhhjkjhjjkjk}
as:
\begin{equation}\label{MaxVacFullPPNmmmffffffiuiuhjuughbghhiuijghghghhjhjhhghyuyjjjhhjhjffkklkljljjkhhjkjhjjkjkjkjkjk}
\frac{1}{c^2_0}\left(\frac{\partial S}{\partial t}+\vec
w_{\mathcal{T}}\cdot\nabla_\vec x S\right)^2\approx\left|\nabla_\vec
x S-\frac{1}{c^2_0}\left(\frac{\partial S}{\partial t}+\vec {\tilde
u}\cdot\nabla_\vec x S\right)\left(\vec {\tilde u}-\vec
w_{\mathcal{T}}\right)\right|^2.
\end{equation}
Then by
\er{MaxVacFullPPNmmmffffffiuiuhjuughbghhiuijghghghhhgjgfghjdfjgddg11}
we finally rewrite
\er{MaxVacFullPPNmmmffffffiuiuhjuughbghhiuijghghghhjhjhhghyuyjjjhhjhjffkklkljljjkhhjkjhjjkjkjkjkjk}
as:
\begin{equation}\label{MaxVacFullPPNmmmffffffiuiuhjuughbghhiuijghghghhjhjhhghyuyjjjhhjhjffkklkljljjkhhjkjhjjkjkjkjkjkjkkhk}
\left(\frac{\partial S}{\partial t}+\vec
w_{\mathcal{T}}\cdot\nabla_\vec x
S\right)^2\approx\left|\frac{1}{c_0}\left(\frac{\partial S}{\partial
t}+\vec {\tilde u}\cdot\nabla_\vec x S\right)\left(\vec h-\vec
w_{\mathcal{T}}\right)\right|^2.
\end{equation}
Similarly we obtain
\begin{equation}\label{MaxVacFullPPNmmmffffffiuiuhjuughbghhiuijghghghhjhjhhghyuyjjjhhjhjffkklkljljjkhhjkjhjjkjkjkjkjkjkkhklljggjop}
\left(\frac{\partial S_1}{\partial t}+\vec
w_{\mathcal{T}}\cdot\nabla_\vec x
S_1\right)^2\approx\left|\frac{1}{c_0}\left(\frac{\partial
S_1}{\partial t}+\vec {\tilde u}\cdot\nabla_\vec x
S_1\right)\left(\vec h_1-\vec w_{\mathcal{T}}\right)\right|^2,
\end{equation}
and
\begin{equation}\label{MaxVacFullPPNmmmffffffiuiuhjuughbghhiuijghghghhjhjhhghyuyjjjhhjhjffkklkljljjkhhjkjhjjkjkjkjkjkjkkhklljggjopjkjk}
\left(\frac{\partial S_2}{\partial t}+\vec
w_{\mathcal{T}}\cdot\nabla_\vec x
S_2\right)^2\approx\left|\frac{1}{c^{(2)}_0}\left(\frac{\partial
S_2}{\partial t}+\vec {\tilde u}_2\cdot\nabla_\vec x
S_2\right)\left(\vec h_2-\vec w_{\mathcal{T}}\right)\right|^2,
\end{equation}
In particular, by
\er{MaxVacFullPPNmmmffffffiuiuhjuughbghhiuijghghghhjhjhhghyuyjjjhhjhjffkklkljljjkhhjkjhjjkjkjkjkjkjkkhk},
\er{MaxVacFullPPNmmmffffffiuiuhjuughbghhiuijghghghhjhjhhghyuyjjjhhjhjffkklkljljjkhhjkjhjjkjkjkjkjkjkkhklljggjop},
\er{MaxVacFullPPNmmmffffffiuiuhjuughbghhiuijghghghhjhjhhghyuyjjjhhjhjffkklkljljjkhhjkjhjjkjkjkjkjkjkkhklljggjopjkjk}
and
\er{MaxMedFullGGffgggyyojjyugggjhhjiiuuiyuyuyyuyuzzhhkkkgdfghgghgh11kjhk11}
we deduce:
\begin{multline}\label{MaxVacFullPPNmmmffffffiuiuhjuughbghhiuijghghghhjhjhhghyuyjjjhhjhjffkklkljljjkhhjkjhjjkjkjkjkjkjkkhklljggjopjkjkklljljggh}
\frac{(c^{(2)}_0)^2}{c_0^2}\left|\frac{c_0}{(c^{(2)}_0)^2}\left(\frac{\partial
S_2}{\partial t}+\vec {\tilde u}_2\cdot\nabla_\vec x
S_2\right)\left(\vec h_2-\vec
w_{\mathcal{T}}\right)\right|^2\approx\\
\left|\frac{1}{c_0}\left(\frac{\partial S_1}{\partial t}+\vec
{\tilde u}\cdot\nabla_\vec x S_1\right)\left(\vec h_1-\vec
w_{\mathcal{T}}\right)\right|^2 \approx
\left|\frac{1}{c_0}\left(\frac{\partial S}{\partial t}+\vec {\tilde
u}\cdot\nabla_\vec x S\right)\left(\vec h-\vec
w_{\mathcal{T}}\right)\right|^2.
\end{multline}
Moreover, by \er{ojhkkkpkkljouu11} we approximate
\er{MaxMedFullGGffgggyyojjyugggjhhjiiuuiyuyuyyuyuzzhhkkkgdfg11lkjjjkk}
as:
\begin{multline}\label{MaxMedFullGGffgggyyojjyugggjhhjiiuuiyuyuyyuyuzzhhkkkgdfg11lkjjjkkhjhjhklljkjkl}
\nabla_{\vec x} S_1-\frac{1}{c^2_0}\left(\frac{\partial
S_1}{\partial t}+\vec {\tilde u}\cdot\nabla_\vec x
S_1\right)\left(\vec {\tilde u}-\vec
w_{\mathcal{T}}\right)-\left(\vec n\cdot \left(\nabla_{\vec x}
S_1-\frac{1}{c^2_0}\left(\frac{\partial S_1}{\partial t}+\vec
{\tilde u}\cdot\nabla_\vec x S_1\right)\left(\vec {\tilde u}-\vec
w_{\mathcal{T}}\right)\right)\right)\vec n
=\\ \left(\nabla_{\vec x} S-\frac{1}{c^2_0}\left(\frac{\partial
S}{\partial t}+\vec {\tilde u}\cdot\nabla_\vec x S\right)\left(\vec
{\tilde u}-\vec w_{\mathcal{T}}\right)\right)-\left(\vec n\cdot
\left(\nabla_{\vec x} S-\frac{1}{c^2_0}\left(\frac{\partial
S}{\partial t}+\vec {\tilde u}\cdot\nabla_\vec x S\right)\left(\vec
{\tilde u}-\vec w_{\mathcal{T}}\right)\right)\right)\vec n
\\=
\left(\nabla_{\vec x}
S_2-\frac{1}{(c^{(2)}_0)^2}\left(\frac{\partial S_2}{\partial
t}+\vec {\tilde u}_2\cdot\nabla_\vec x S_2\right)\left(\vec {\tilde
u}_2-\vec w_{\mathcal{T}}\right)\right)\\-\left(\vec n\cdot
\left(\nabla_{\vec x}
S_2-\frac{1}{(c^{(2)}_0)^2}\left(\frac{\partial S_2}{\partial
t}+\vec {\tilde u}_2\cdot\nabla_\vec x S_2\right)\left(\vec {\tilde
u}_2-\vec
w_{\mathcal{T}}\right)\right)\right)\vec n\\
+\frac{1}{c_0^2}\left(\frac{\partial S_2}{\partial t}+\vec {\tilde
u}_2\cdot\nabla_\vec x
S_2\right)\left(\frac{c^2_0}{(c^{(2)}_0)^2}\vec {\tilde u}_2-\vec
{\tilde u}-\left(\frac{c^2_0}{(c^{(2)}_0)^2}-1\right)\vec
w_{\mathcal{T}}\right)\\-\frac{1}{c_0^2}\left(\frac{\partial
S_2}{\partial t}+\vec {\tilde u}_2\cdot\nabla_\vec x
S_2\right)\left(\vec n\cdot \left(\frac{c^2_0}{(c^{(2)}_0)^2}\vec
{\tilde u}_2-\vec {\tilde
u}-\left(\frac{c^2_0}{(c^{(2)}_0)^2}-1\right)\vec
w_{\mathcal{T}}\right)\right)\vec n
\quad\quad\forall\vec x\in\mathcal{T},\;\forall t,
\end{multline}
and then by
\er{MaxVacFullPPNmmmffffffiuiuhjuughbghhiuijghghghhhgjgfghjdfjgddg11}
rewrite it as:
\begin{multline}\label{MaxMedFullGGffgggyyojjyugggjhhjiiuuiyuyuyyuyuzzhhkkkgdfg11lkjjjkkhjhjhggojoj}
\frac{1}{c_0}\left(\frac{\partial S_1}{\partial t}+\vec {\tilde
u}\cdot\nabla_\vec x S_1\right)\left(\vec h_1-\vec
w_{\mathcal{T}}\right)- \frac{1}{c_0}\left(\frac{\partial
S_1}{\partial t}+\vec {\tilde u}\cdot\nabla_\vec x
S_1\right)\left(\vec n\cdot\left(\vec h_1-\vec
w_{\mathcal{T}}\right)\right)\vec n=\\
\frac{1}{c_0}\left(\frac{\partial S}{\partial t}+\vec {\tilde
u}\cdot\nabla_\vec x S\right)\left(\vec h-\vec
w_{\mathcal{T}}\right)- \frac{1}{c_0}\left(\frac{\partial
S}{\partial t}+\vec {\tilde u}\cdot\nabla_\vec x S\right)\left(\vec
n\cdot\left(\vec h-\vec w_{\mathcal{T}}\right)\right)\vec n\\
=\frac{c_0}{(c^{(2)}_0)^2}\left(\frac{\partial S_2}{\partial t}+\vec
{\tilde u}_2\cdot\nabla_\vec x S_2\right)\left(\vec h_2-\vec
w_{\mathcal{T}}\right)-
\frac{c_0}{(c^{(2)}_0)^2}\left(\frac{\partial S_2}{\partial t}+\vec
{\tilde u}_2\cdot\nabla_\vec x S_2\right)\left(\vec n\cdot\left(\vec
h_2-\vec w_{\mathcal{T}}\right)\right)\vec n \\
-\frac{1}{c_0}\left(\frac{\partial S_2}{\partial t}+\vec {\tilde
u}_2\cdot\nabla_\vec x
S_2\right)\left(\frac{c^2_0}{(c^{(2)}_0)^2}\vec {\tilde u}_2-\vec
{\tilde u}-\left(\frac{c^2_0}{(c^{(2)}_0)^2}-1\right)\vec
w_{\mathcal{T}}\right)\\+\frac{1}{c_0}\left(\frac{\partial
S_2}{\partial t}+\vec {\tilde u}_2\cdot\nabla_\vec x
S_2\right)\left(\vec n\cdot \left(\frac{c^2_0}{(c^{(2)}_0)^2}\vec
{\tilde u}_2-\vec {\tilde
u}-\left(\frac{c^2_0}{(c^{(2)}_0)^2}-1\right)\vec
w_{\mathcal{T}}\right)\right)\vec n \quad\quad\forall\vec
x\in\mathcal{T},\;\forall t.
\end{multline}
Then, since the directions of vectors $\vec h$ and $\vec h_1$ should
be different, by
\er{MaxMedFullGGffgggyyojjyugggjhhjiiuuiyuyuyyuyuzzhhkkkgdfg11lkjjjkkhjhjhggojoj}
and
\er{MaxVacFullPPNmmmffffffiuiuhjuughbghhiuijghghghhjhjhhghyuyjjjhhjhjffkklkljljjkhhjkjhjjkjkjkjkjkjkkhklljggjopjkjkklljljggh}
we deduce
\begin{equation}\label{MaxMedFullGGffgggyyojjyugggjhhjiiuuiyuyuyyuyuzzhhkkkgdfghgghghijhjhhkh11}
\left(\frac{\partial S_1}{\partial t}+\vec {\tilde
u}\cdot\nabla_\vec x S_1\right)\left(\vec n\cdot\left(\vec h_1-\vec
w_{\mathcal{T}}\right)\right)=-\left(\frac{\partial S}{\partial
t}+\vec {\tilde u}\cdot\nabla_\vec x S\right)\left(\vec
n\cdot\left(\vec h-\vec
w_{\mathcal{T}}\right)\right)\quad\quad\forall\vec x\in\mathcal{T}.
\end{equation}
So, by
\er{MaxMedFullGGffgggyyojjyugggjhhjiiuuiyuyuyyuyuzzhhkkkgdfg11lkjjjkkhjhjhggojoj}
and
\er{MaxMedFullGGffgggyyojjyugggjhhjiiuuiyuyuyyuyuzzhhkkkgdfghgghghijhjhhkh11}
we obtain the law of reflection: vector $(\vec h_1-\vec
w_{\mathcal{T}})$ lies in the plane formed by vectors $\vec n$ and
$(\vec h-\vec w_{\mathcal{T}})$, and we have:
\begin{equation}\label{MaxMedFulljhhjjjjj11}
\theta\left((\vec h-\vec w_{\mathcal{T}}),-\vec
n\right)=\theta_1\left((\vec h_1-\vec w_{\mathcal{T}}),\vec n\right)
\end{equation}
where $\theta\left((\vec h-\vec w_{\mathcal{T}}),-\vec n\right)$ is
the angle between the vector of the relative velocity of the
incoming ray, relative to the surface of reflection, $(\vec h-\vec
w_{\mathcal{T}})$ and the incoming normal to the surface $-\vec n$
and $\theta_1\left((\vec h_1-\vec w_{\mathcal{T}}),\vec n\right)$ is
the angle between the vector of the relative velocity of the
reflected ray, relative to the surface of reflection, $(\vec
h_1-\vec w_{\mathcal{T}})$ and the outcoming normal $\vec n$. Note
that, since $\vec h$ and $\vec w_{\mathcal{T}}$ are both speed like
vector fields then $(\vec h-\vec w_{\mathcal{T}})$ is a
\underline{proper} vector field and thus the above law of reflection
together with \er{MaxMedFulljhhjjjjj11} are invariant under the
change of inertial or non-inertial cartesian coordinate systems. In
particular, if \er{ojhkkkpkkljouu11} holds for some cartesian
coordinate system, then we can use this law also in other coordinate
systems where \er{ojhkkkpkkljouu11} does not hold. Therefore, for
the validity of the above law of reflection we may assume the
following relation instead of \er{ojhkkkpkkljouu11}:
\begin{equation}\label{ojhkkkpkkljouu11uihkhkjkjk}
\frac{|\vec {\tilde u}-\vec w_\mathcal{T}|^2}{c^2_0}\ll 1.
\end{equation}

Next, assume that the wave we consider in
\er{MaxVacFullPPNmmmffffffiuiuhjuughbghhuiiujjhhjjhjhhj11} has an
electromagnetic nature. Then by \er{gughhghfbvnbvyyuuyr} we have
\begin{equation}\label{gughhghfbvnbvyyuuy11}
c_0=c\sqrt{\kappa_0\gamma_0}\quad\text{and}\quad\vec {\tilde
u}=\left(\gamma_0\kappa_0\vec v+(1-\gamma_0\kappa_0)\vec u\right),
\end{equation}
where, $\vec u$ is the medium velocity and $\vec v$ is the local
vectorial gravitational potential. Similarly, on the second side of
surface $\mathcal{T}$ we have
\begin{equation}\label{gughhghfbvnbvyyuuyGGHHG11}
c^{(2)}_0=c\sqrt{\kappa^{(2)}_0\gamma^{(2)}_0}\quad\text{and}\quad\vec
{\tilde u}_2=\left(\gamma^{(2)}_0\kappa^{(2)}_0\vec
v+(1-\gamma^{(2)}_0\kappa^{(2)}_0)\vec u_2\right),
\end{equation}
where, $\vec u_2$ is the medium velocity on the second side of
surface $\mathcal{T}$. Then we rewrite the first equalities in
\er{gughhghfbvnbvyyuuy11} and \er{gughhghfbvnbvyyuuyGGHHG11} as:
\begin{equation}\label{gughhghfbvnbvyyuuyggg11}
c_0=c\sqrt{\kappa_0\gamma_0}\quad\text{and}\quad
c^{(2)}_0=c\sqrt{\kappa^{(2)}_0\gamma^{(2)}_0}.
\end{equation}
Then in particular, by \er{gughhghfbvnbvyyuuyggg11} we deduce
\begin{equation}\label{gughhghfbvnbvyyuuygggyityutt}
\frac{c^2_0}{(c^{(2)}_0)^2}=\frac{\gamma_0\kappa_0}{\gamma^{(2)}_0\kappa^{(2)}_0}.
\end{equation}
Thus by \er{gughhghfbvnbvyyuuygggyityutt}, \er{gughhghfbvnbvyyuuy11}
and \er{gughhghfbvnbvyyuuyGGHHG11} we deduce:
\begin{multline}\label{MaxMedFullGGffgggyyojjyugggjhhjiiuuiyuyuyyuyuzzhhkkkgdfg11lkjjjkkhjhjhggojojjiijhhihujkhk}
\left(\frac{c^2_0}{(c^{(2)}_0)^2}\vec {\tilde u}_2-\vec {\tilde
u}-\left(\frac{c^2_0}{(c^{(2)}_0)^2}-1\right)\vec
w_{\mathcal{T}}\right)=\\
\left(\frac{\gamma_0\kappa_0}{\gamma^{(2)}_0\kappa^{(2)}_0}\left(\gamma^{(2)}_0\kappa^{(2)}_0\vec
v+(1-\gamma^{(2)}_0\kappa^{(2)}_0)\vec
u_2\right)-\left(\gamma_0\kappa_0\vec v+(1-\gamma_0\kappa_0)\vec
u\right)-\left(\frac{\gamma_0\kappa_0}{\gamma^{(2)}_0\kappa^{(2)}_0}-1\right)\vec
w_{\mathcal{T}}\right)
=\\
\left(\frac{\gamma_0\kappa_0}{\gamma^{(2)}_0\kappa^{(2)}_0}-1\right)\left(\vec
u_2-\vec w_{\mathcal{T}}\right)+(1-\gamma_0\kappa_0)\left(\vec
u_2-\vec u\right).
\end{multline}
On the other hand, since $\vec u$ and $\vec u_2$ are velocities of
the medium on both sides of the surface $\mathcal{T}$ and $\vec
w_{\mathcal{T}}$ is the velocity of the surface, we have equalities
of these three vectors on the surface:
\begin{equation}\label{MaxMedFullGGffgggyyojjyugggjhhjiiuuiyuyuyyuyuzzhhkkkgdfg111ioouiu}
\vec u_2(\vec x,t)
=\vec u(\vec x,t)
=\vec w_{\mathcal{T}}(\vec x,t)
\quad\quad\forall\vec x\in\mathcal{T},\;\forall t.
\end{equation}
Thus with the help of
\er{MaxMedFullGGffgggyyojjyugggjhhjiiuuiyuyuyyuyuzzhhkkkgdfg11lkjjjkkhjhjhggojojjiijhhihujkhk}
and
\er{MaxMedFullGGffgggyyojjyugggjhhjiiuuiyuyuyyuyuzzhhkkkgdfg111ioouiu}
we infer:
\begin{multline}\label{MaxMedFullGGffgggyyojjyugggjhhjiiuuiyuyuyyuyuzzhhkkkgdfg11lkjjjkkhjhjhggojojiuuii}
\frac{1}{c_0}\left(\frac{\partial S_2}{\partial t}+\vec {\tilde
u}_2\cdot\nabla_\vec x
S_2\right)\left(\frac{c^2_0}{(c^{(2)}_0)^2}\vec {\tilde u}_2-\vec
{\tilde u}-\left(\frac{c^2_0}{(c^{(2)}_0)^2}-1\right)\vec
w_{\mathcal{T}}\right)\\-\frac{1}{c_0}\left(\frac{\partial
S_2}{\partial t}+\vec {\tilde u}_2\cdot\nabla_\vec x
S_2\right)\left(\vec n\cdot \left(\frac{c^2_0}{(c^{(2)}_0)^2}\vec
{\tilde u}_2-\vec {\tilde
u}-\left(\frac{c^2_0}{(c^{(2)}_0)^2}-1\right)\vec
w_{\mathcal{T}}\right)\right)\vec n =0\quad\quad\forall\vec
x\in\mathcal{T},\;\forall t.
\end{multline}
Note that if there is a vacuum on the one side of surface
$\mathcal{T}$ then we can obtain
\er{MaxMedFullGGffgggyyojjyugggjhhjiiuuiyuyuyyuyuzzhhkkkgdfg11lkjjjkkhjhjhggojojiuuii}
from
\er{MaxMedFullGGffgggyyojjyugggjhhjiiuuiyuyuyyuyuzzhhkkkgdfg11lkjjjkkhjhjhggojojjiijhhihujkhk}
with even easier proof. Therefore, using
\er{MaxMedFullGGffgggyyojjyugggjhhjiiuuiyuyuyyuyuzzhhkkkgdfg11lkjjjkkhjhjhggojojiuuii}
by
\er{MaxMedFullGGffgggyyojjyugggjhhjiiuuiyuyuyyuyuzzhhkkkgdfg11lkjjjkkhjhjhggojoj}
we deduce
\begin{multline}\label{MaxMedFullGGffgggyyojjyugggjhhjiiuuiyuyuyyuyuzzhhkkkgdfg11lkjjjkkhjhjhggojojjhjggjhkh}
\frac{1}{c_0}\left(\frac{\partial S_1}{\partial t}+\vec {\tilde
u}\cdot\nabla_\vec x S_1\right)\left(\vec h_1-\vec
w_{\mathcal{T}}\right)- \frac{1}{c_0}\left(\frac{\partial
S_1}{\partial t}+\vec {\tilde u}\cdot\nabla_\vec x
S_1\right)\left(\vec n\cdot\left(\vec h_1-\vec
w_{\mathcal{T}}\right)\right)\vec n=\\
\frac{c_0}{(c^{(2)}_0)^2}\left(\frac{\partial S_2}{\partial t}+\vec
{\tilde u}_2\cdot\nabla_\vec x S_2\right)\left(\vec h_2-\vec
w_{\mathcal{T}}\right)-
\frac{c_0}{(c^{(2)}_0)^2}\left(\frac{\partial S_2}{\partial t}+\vec
{\tilde u}_2\cdot\nabla_\vec x S_2\right)\left(\vec n\cdot\left(\vec
h_2-\vec w_{\mathcal{T}}\right)\right)\vec n\\ \quad\quad\forall\vec
x\in\mathcal{T},\;\forall t,
\end{multline}
and in particular, for every point on the surface $\mathcal{T}$
vector $(\vec h_2-\vec w_{\mathcal{T}})$ lies in the plane formed by
vectors $\vec n$ and $(\vec h-\vec w_{\mathcal{T}})$. On the other
hand, by
\er{MaxVacFullPPNmmmffffffiuiuhjuughbghhiuijghghghhjhjhhghyuyjjjhhjhjffkklkljljjkhhjkjhjjkjkjkjkjkjkkhklljggjopjkjkklljljggh}
we have
\begin{multline}\label{MaxVacFulfgjhgikhkouoljjk}
\frac{(c^{(2)}_0)^2}{c_0^2}\left|\frac{c_0}{(c^{(2)}_0)^2}\left(\frac{\partial
S_2}{\partial t}+\vec {\tilde u}_2\cdot\nabla_\vec x
S_2\right)\left(\vec h_2-\vec w_{\mathcal{T}}\right)\right|^2\approx
\left|\frac{1}{c_0}\left(\frac{\partial S_1}{\partial t}+\vec
{\tilde u}\cdot\nabla_\vec x S_1\right)\left(\vec h_1-\vec
w_{\mathcal{T}}\right)\right|^2.
\end{multline}
Thus by
\er{MaxMedFullGGffgggyyojjyugggjhhjiiuuiyuyuyyuyuzzhhkkkgdfg11lkjjjkkhjhjhggojojjhjggjhkh}
and \er{MaxVacFulfgjhgikhkouoljjk}, exactly as in the proof of
\er{MaxMedFulljhhjjjjjhhhj}, we finally deduce that we have the
following Snell's law of refraction: $(\vec h_2-\vec
w_{\mathcal{T}})$ lies in the plane formed by vectors $\vec n$ and
$(\vec h-\vec w_{\mathcal{T}})$ and we have:
\begin{equation}\label{MaxMedFulljhhjjjjjhhhj11}
n\sin{\left(\theta\left((\vec h-\vec w_{\mathcal{T}}),\vec
n\right)\right)}=n_2\sin{\left(\theta_2\left((\vec h_2-\vec
w_{\mathcal{T}}),\vec n\right)\right)}
\end{equation}
where $\theta\left((\vec h-\vec w_{\mathcal{T}}),\vec n\right)$ is
the angle between the vector of the relative velocity of the
incoming ray, relative to the surface of refraction, $(\vec h-\vec
w_{\mathcal{T}})$ and the normal to the surface $\vec n$,
$\theta_2\left((\vec h_2-\vec w_{\mathcal{T}}),\vec n\right)$ is the
angle between the vector of the relative velocity of the refracted
ray, relative to the surface of refraction, $(\vec h_2-\vec
w_{\mathcal{T}})$ and the normal $\vec n$ and as in
\er{MaxMedFullGGffgggyyojjhhjkhjyyiuhggjhhjhuyytytyuuytrrtghjtyuggyuighjuyioyyfgffhyuhhghzzrrkkhhkkkhhhjhkjhhghhgghiuiu1}
we set refraction indexes:
\begin{equation}\label{MaxMedFullGGffgggyyojjhhjkhjyyiuhggjhhjhuyytytyuuytrrtghjtyuggyuighjuyioyyfgffhyuhhghzzrrkkhhkkkhhhjhkjhhghhgghiuiujjkjk11}
n:=\frac{c}{c_0}\quad\text{and}\quad n_2:=\frac{c}{c^{(2)}_0}.
\end{equation}
%
%
%
Note that as the above law of reflection the Snell's law together
with \er{MaxMedFulljhhjjjjjhhhj11} are invariant under the change of
inertial or non-inertial cartesian coordinate systems. In
particular, if \er{ojhkkkpkkljouu11} holds for some cartesian
coordinate system, then we can use this law also in other coordinate
systems where \er{ojhkkkpkkljouu11} does not hold. Therefore, as
before, for the validity of the above Snell's law we may assume the
following relation instead of \er{ojhkkkpkkljouu11}:
\begin{equation}\label{ojhkkkpkkljouu11uihkhkjkjkkliyuttyty}
\frac{|\vec {\tilde u}-\vec w_\mathcal{T}|^2}{c^2_0}\ll
1\quad\text{and}\quad\frac{|\vec {\tilde u}_2-\vec
w_\mathcal{T}|^2}{(c^{(2)}_0)^2}\ll 1.
\end{equation}

\subsubsection{Polarization of the light inside a moving medium and/or in the presence of gravitational
field}\label{jgjghjghhffhf}
Again consider the system of Maxwell equations in the vacuum or in a
medium, having the form \er{MaxMedFullGGffykkhjhhzz} in the absence
of macroscopic charges and/or currents:
\begin{equation}\label{MaxVacFullPPNffGGhjhjhjh}
\begin{cases}
curl_{\vec x} \vec H=
\frac{1}{c}\frac{\partial \vec D}{\partial t},\\
div_{\vec x} \vec D=0
,\\
curl_{\vec x} \vec E+\frac{1}{c}\frac{\partial \vec B}{\partial t}=0,\\
div_{\vec x} \vec B=0,\\
\vec E=\gamma_0\vec D-\frac{1}{c}\,\vec {\tilde u}\times \vec B,\\
\vec H=\kappa_0\vec B+\frac{1}{c}\,\vec {\tilde u}\times \vec D,
\\ \vec {\tilde u}=\left(\gamma_0\kappa_0\vec v+(1-\gamma_0\kappa_0)\vec u\right),
\end{cases}
\end{equation}
and consider the case of monochromatic wave of the constant
frequency $\nu=\frac{\omega}{2\pi}$ where the fields $\vec {\tilde
u}$, $\gamma_0$ and $\kappa_0$ are independent of the time variable.
We also again assume that either our medium has no dispersion or
$\vec u\equiv 0$ (and so $\vec {\tilde u}\equiv\vec v$) and the
medium is transparent for the given frequency
$\nu=\frac{\omega}{2\pi}$. Next assume the \underline{rough}
Geometric Optics approximation \er{slochangGGaaffgfgjhjggh11}
(stronger than \er{slochangGGaaffgfgjhjggh}) that means the
following: assume that the changes in time of the basic
characteristics of the electromagnetic field become essential after
certain interval of time $T_e$ and the changes in space of of the
basic characteristics of the electromagnetic field become essential
in the spatial landscape $L_e$. Then we assume that
\begin{equation}\label{slochangGGaaffgfgjhjgghhkjhklkllk}
k_0c_0T_e\,\gg\, 1\quad\text{and}\quad k_0L_e\,\gg\, 1,
\end{equation}
where
\begin{equation}\label{MaxVacFullPPNmmmffffffiuiuhjuughbghhuiiujjhhjjhjhhjyuyhhjjjkjk}
k_0=\frac{\omega}{c}.
\end{equation}
We also assume \er{ojhkk}:
\begin{equation}\label{ojhkkkhhhh}
\frac{|\vec {\tilde u}|^2}{c^2_0}\ll 1.
\end{equation}
Then consistently with
\er{MaxVacFullPPNmmmffffffiuiuhjuughbghhuiiujj} we can write
\begin{equation}\label{MaxVacFullPPNmmmffffffiuiuhjuughbghhuiiujjnewjikhkjhjkkjhhjkkhhhj}
\begin{cases}
\vec D(\vec x,t)=\Xi_1\cdot\vec D_a(\vec x)e^{ik_0S(\vec x,t)}
\\
\vec B(\vec x,t)=\Xi_2\cdot\vec B_a(\vec x)e^{ik_0S(\vec x,t)}
\\
\vec E(\vec x,t)=\Xi_3\cdot\vec E_a(\vec x)e^{ik_0S(\vec x,t)}
\\
\vec H(\vec x,t)=\Xi_4\cdot\vec H_a(\vec x)e^{ik_0S(\vec x,t)}
\\
\vec E=\gamma_0\vec D-\frac{1}{c}\,\vec {\tilde u}\times \vec B\\
\vec H=\kappa_0\vec B+\frac{1}{c}\,\vec {\tilde u}\times \vec D.
\end{cases}
\end{equation}
Here $\vec D_a(\vec x),\vec B_a(\vec x),\vec E_a(\vec x),\vec
H_a(\vec x):\mathbb{R}^3\to\mathbb{R}^3$ are real vector fields,
independent of the time variable, $S(\vec x,t)$ is a real function
such that $\frac{\partial S}{\partial t}=c$ and
$\Xi_1,\Xi_2,\Xi_3,\Xi_4\in \mathbb{C}^{3\times 3}$ are constant
complex diagonal matrices of the form:
\begin{equation}\label{ghgyfftdtdljkljnnewkkkkk}
\Xi_k=\left(\begin{matrix}e^{i\theta_{1k}}&0&0\\0&e^{i\theta_{2k}}&0\\0&0&e^{i\theta_{3k}}\end{matrix}\right)\quad\quad\quad\quad\forall
k=1,2,3,4,
\end{equation}
where $\theta_{1k},\theta_{2k},\theta_{3k}$ are real constants. Then
denoting:
\begin{equation}\label{MaxVacFullPPNmmmffffffiuiuhjuughbghhuiiujjnewjikhkjhjkkjhhjkkhhhjhkhjh}
\begin{cases}
\vec D_1(\vec x)=\Xi_1\cdot\vec D_a(\vec x)
\\
\vec B_1(\vec x)=\Xi_2\cdot\vec B_a(\vec x)
\\
\vec E_1(\vec x)=\Xi_3\cdot\vec E_a(\vec x)
\\
\vec H_1(\vec x)=\Xi_4\cdot\vec H_a(\vec x),
\end{cases}
\end{equation}
we deduce:
\begin{equation}\label{MaxVacFullPPNmmmffffffiuiuhjuughbghhuiiujjnewjikhkjhjkkjhh}
\begin{cases}
\vec D(\vec x,t)=\vec D_1(\vec x)e^{ik_0S(\vec x,t)}
\\
\vec B(\vec x,t)=\vec B_1(\vec x)e^{ik_0S(\vec x,t)}
\\
\vec E(\vec x,t)=\vec E_1(\vec x)e^{ik_0S(\vec x,t)}
\\
\vec H(\vec x,t)=\vec H_1(\vec x)e^{ik_0S(\vec x,t)}
\\
\vec E_1=\gamma_0\vec D_1-\frac{1}{c}\,\vec {\tilde u}\times \vec B_1\\
\vec H_1=\kappa_0\vec B_1+\frac{1}{c}\,\vec {\tilde u}\times \vec
D_1.
\end{cases}
\end{equation}
Note that $\vec D_1,\vec B_1,\vec E_1,\vec H_1$ are complex. Then,
consistently with
\er{MaxVacFullPPNmmmffffffiuiuhjuughbghhiuijghghghhjhjhhghyuyiyyujjljk},
$S$ satisfies the Eikonal equation:
\begin{equation}\label{MaxVacFullPPNmmmffffffiuiuhjuughbghhiuijghghghhhfhhghghguygtjuuujjjknewljjljllkgugghklkllljojjj}
\left|\frac{c\vec {\tilde u}}{c^2_0}-\nabla_\vec x
S\right|^2=\frac{c^2}{c^2_0},
\end{equation}
and consistently with
\er{MaxVacFullPPNmmmffffffiuiuhjuughbghhiuijghghghhhfhhghghguygtjuuujjjk}
if $\vec A_1$ denotes either one of the vectors $\vec D_a,\vec
B_a,\vec E_a,\vec H_a$ or one of the vectors $\vec D_1,\vec B_1,\vec
E_1,\vec H_1$ then:
\begin{equation}\label{MaxVacFullPPNmmmffffffiuiuhjuughbghhiuijghghghhhfhhghghguygtjuuujjjknewljjljllkgugghklkllljjkjkkkkl}
\left\{d_{\vec x}\vec A_1\right\}^T\cdot\left(\frac{c\vec {\tilde
u}}{c^2_0}-\nabla_{\vec x}S\right)+\frac{1}{2}\left(-\Delta_{\vec
x}S\right)\vec A_1=0.
\end{equation}
Moreover, consistently with
\er{MaxVacFullPPNmmmffffffiuiuhjuughbghhiuijghghghhjhjhhghyuyhhjhjihikjjjjkjljkl},
\er{MaxVacFullPPNmmmffffffiuiuhjuughbghhiuijghghghhjhjhhghyuyhhjhjihikjjj}
and \er{gughhghfbvnbvyyuuyr} we have:
\begin{equation}\label{MaxVacFullPPNmmmffffffiuiuhjuughbghhiuijghghghhjhjhhghyuyhhjhjihikjjjnew}
\begin{cases}
c_0=c\sqrt{\kappa_0\gamma_0}\\
\frac{\partial S}{\partial t}=c\\
\frac{\partial \vec A_1}{\partial t}=0,\\
\frac{\partial \vec {\tilde u}}{\partial t}=0\\
\frac{\partial c_0}{\partial t}=0,
\end{cases}
\end{equation}
and, consistently with
\er{MaxVacFullPPNmmmffffffiuiuhjuughbghhiuijghghghhhfhhghghguygtjuuujjjkyuuy},
the vector field $\vec h_0$ defined for every $\vec x$ by:
\begin{equation}\label{MaxVacFullPPNmmmffffffiuiuhjuughbghhiuijghghghhhfhhghghguygtjuuujjjkyuuykkhhk}
\vec h_0(\vec x):=\frac{c}{c^2_0(\vec x)}\vec {\tilde u}(\vec
x)-\nabla_{\vec x}S(\vec x)\approx\frac{c}{c^2_0(\vec x)}\vec h(\vec
x) ,
\end{equation}
is the direction of the propagation of the ray that passes through
point $\vec x$.

Next, by inserting
\er{MaxVacFullPPNmmmffffffiuiuhjuughbghhuiiujjnewjikhkjhjkkjhh} into
\er{MaxVacFullPPNffGGhjhjhjh}, using the fact that $\frac{\partial
S}{\partial t}=c$ and using the rough approximation
\er{slochangGGaaffgfgjhjgghhkjhklkllk} together with \er{apfrm5} we
obtain:
\begin{equation}\label{MaxVacFullPPNffGGhjhjhjhjkhkh}
\begin{cases}
\vec D_1\approx \nabla_{\vec x}S\times \vec H_1,\\
\nabla_{\vec x}S\cdot \vec D_1\approx 0
,\\
\vec B_1\approx-\nabla_{\vec x}S\times  \vec E_1,\\
\nabla_{\vec x}S\cdot \vec B_1\approx 0.
\end{cases}
\end{equation}
Thus, by \er{MaxVacFullPPNffGGhjhjhjhjkhkh} and the last two
equations in
\er{MaxVacFullPPNmmmffffffiuiuhjuughbghhuiiujjnewjikhkjhjkkjhh},
using \er{apfrm2} we obtain:
\begin{multline}\label{MaxVacFullPPNffGGhjhjhjhjkhkhljjkjn}
\vec D_1\approx \nabla_{\vec x}S\times \left(\kappa_0\vec
B_1+\frac{1}{c}\,\vec {\tilde u}\times \vec
D_1\right)=\kappa_0\nabla_{\vec x}S\times\vec
B_1-\frac{1}{c}\left(\vec {\tilde u}\cdot\nabla_{\vec x}S\right)\vec D_1\quad\text{and}\\
\vec B_1\approx-\nabla_{\vec x}S\times  \left(\gamma_0\vec
D_1-\frac{1}{c}\,\vec {\tilde u}\times \vec
B_1\right)=-\gamma_0\nabla_{\vec x}S\times\vec
D_1-\frac{1}{c}\left(\vec {\tilde u}\cdot\nabla_{\vec x}S\right)\vec
B_1.
\end{multline}
So
\begin{multline}\label{MaxVacFullPPNffGGhjhjhjhjkhkhljjkjnhhhgjgjg}
\vec D_1\approx\frac{\kappa_0}{\left(1+\frac{1}{c}\vec {\tilde
u}\cdot\nabla_{\vec x}S\right)}\nabla_{\vec x}S\times\vec
B_1\quad\quad\text{and}\quad\quad \vec
B_1\approx-\frac{\gamma_0}{\left(1+\frac{1}{c}\vec {\tilde
u}\cdot\nabla_{\vec x}S\right)}\nabla_{\vec x}S\times\vec D_1.
\end{multline}
Thus by inserting \er{MaxVacFullPPNffGGhjhjhjhjkhkhljjkjnhhhgjgjg}
and \er{MaxVacFullPPNffGGhjhjhjhjkhkh} into
\er{MaxVacFullPPNmmmffffffiuiuhjuughbghhuiiujjnewjikhkjhjkkjhh} we
infer:
\begin{equation}\label{MaxVacFullPPNffGGhjhjhjhjkhkhklklk}
\begin{cases}
 \vec
B\approx\frac{\gamma_0}{\left(1+\frac{1}{c}\vec {\tilde
u}\cdot\nabla_{\vec x}S\right)}\left(-\nabla_{\vec
x}S\right)\times\vec D
\\
\vec D\approx-\frac{\kappa_0}{\left(1+\frac{1}{c}\vec {\tilde
u}\cdot\nabla_{\vec x}S\right)}\left(-\nabla_{\vec
x}S\right)\times\vec B
\\
\left(-\nabla_{\vec x}S\right)\cdot \vec D\approx 0\\
 \left(-\nabla_{\vec x}S\right)\cdot \vec
B\approx 0.
\end{cases}
\end{equation}
So the vectors $(-\nabla_{\vec x}S)\,$, $\,\vec D$ and $\vec B$ form
together rightly orientated orthogonal system of vectors.

Next by \er{MaxVacFullPPNffGGhjhjhjhjkhkhljjkjnhhhgjgjg} and the
last two equations in
\er{MaxVacFullPPNmmmffffffiuiuhjuughbghhuiiujjnewjikhkjhjkkjhh} we
deduce:
\begin{multline}\label{MaxVacFullPPNffGGhjhjhjhjkhkhljjkjnhhhhnhkhk}
\vec E_1\approx\left(\frac{\gamma_0\kappa_0}{\left(1+\frac{1}{c}\vec
{\tilde u}\cdot\nabla_{\vec x}S\right)}\nabla_{\vec
x}S-\frac{1}{c}\vec {\tilde u}\right)\times\vec
B_1\quad\text{and}\quad \vec
H_1\approx-\left(\frac{\gamma_0\kappa_0}{\left(1+\frac{1}{c}\vec
{\tilde u}\cdot\nabla_{\vec x}S\right)}\nabla_{\vec
x}S-\frac{1}{c}\vec {\tilde u}\right)\times\vec D_1,
\end{multline}
and in particular,
\begin{equation}\label{MaxVacFullPPNffGGhjhjhjhjkhkhljjkjnhhhhnhkhkojj}
\left(\frac{\gamma_0\kappa_0}{\left(1+\frac{1}{c}\vec {\tilde
u}\cdot\nabla_{\vec x}S\right)}\nabla_{\vec x}S-\frac{1}{c}\vec
{\tilde u}\right)\cdot\vec E_1\approx 0\quad\text{and}\quad
\left(\frac{\gamma_0\kappa_0}{\left(1+\frac{1}{c}\vec {\tilde
u}\cdot\nabla_{\vec x}S\right)}\nabla_{\vec x}S-\frac{1}{c}\vec
{\tilde u}\right)\cdot\vec H_1\approx 0.
\end{equation}
Then, by \er{MaxVacFullPPNffGGhjhjhjhjkhkhljjkjnhhhhnhkhk} and by
the last two equations in
\er{MaxVacFullPPNmmmffffffiuiuhjuughbghhuiiujjnewjikhkjhjkkjhh},
using \er{ojhkkkhhhh} we deduce:
\begin{multline}\label{MaxVacFullPPNffGGhjhjhjhjkhkhljjkjnhhhhnhkhkhhjhjhj}
\vec
E_1\approx\frac{1}{\kappa_0}\left(\frac{\gamma_0\kappa_0}{\left(1+\frac{1}{c}\vec
{\tilde u}\cdot\nabla_{\vec x}S\right)}\nabla_{\vec
x}S-\frac{1}{c}\vec {\tilde u}\right)\times\left(\vec
H_1-\frac{1}{c\gamma_0}\,\vec {\tilde u}\times \vec
E_1\right)\quad\text{and}\\ \quad \vec
H_1\approx-\frac{1}{\gamma_0}\left(\frac{\gamma_0\kappa_0}{\left(1+\frac{1}{c}\vec
{\tilde u}\cdot\nabla_{\vec x}S\right)}\nabla_{\vec
x}S-\frac{1}{c}\vec {\tilde u}\right)\times\left(\vec
E_1+\frac{1}{c\kappa_0}\,\vec {\tilde u}\times \vec H_1\right)
\end{multline}
So, by \er{apfrm2} and
\er{MaxVacFullPPNffGGhjhjhjhjkhkhljjkjnhhhhnhkhkojj}, using
\er{ojhkkkhhhh}, we rewrite
\er{MaxVacFullPPNffGGhjhjhjhjkhkhljjkjnhhhhnhkhkhhjhjhj} as:
\begin{multline}\label{MaxVacFullPPNffGGhjhjhjhjkhkhljjkjnhhhhnhkhkhhjhjhjjhjgj}
\frac{1}{\left(1+\frac{1}{c}\vec {\tilde u}\cdot\nabla_{\vec
x}S\right)}\vec
E_1\approx\frac{1}{\kappa_0}\left(\frac{\gamma_0\kappa_0}{\left(1+\frac{1}{c}\vec
{\tilde u}\cdot\nabla_{\vec x}S\right)}\nabla_{\vec
x}S-\frac{1}{c}\vec {\tilde u}\right)\times\vec H_1\quad\text{and}\\
\quad\frac{1}{\left(1+\frac{1}{c}\vec {\tilde u}\cdot\nabla_{\vec
x}S\right)} \vec
H_1\approx-\frac{1}{\gamma_0}\left(\frac{\gamma_0\kappa_0}{\left(1+\frac{1}{c}\vec
{\tilde u}\cdot\nabla_{\vec x}S\right)}\nabla_{\vec
x}S-\frac{1}{c}\vec {\tilde u}\right)\times\vec E_1
\end{multline}
Then, using \er{ojhkkkhhhh} and
\er{MaxVacFullPPNmmmffffffiuiuhjuughbghhiuijghghghhjhjhhghyuyhhjhjihikjjjnew},
we finally rewrite
\er{MaxVacFullPPNffGGhjhjhjhjkhkhljjkjnhhhhnhkhkhhjhjhjjhjgj} as:
\begin{equation}\label{MaxVacFullPPNffGGhjhjhjhjkhkhljjkjnhhhhnhkhkhhjhjhjjhjgjlljkjuiy}
\vec E_1\approx
-\gamma_0\left(\frac{c}{c^2_0}\vec {\tilde u}-\nabla_{\vec
x}S\right)\times\vec H_1\quad\quad\text{and}\quad \quad \vec
H_1\approx\kappa_0
\left(\frac{c}{c^2_0}\vec {\tilde u}-\nabla_{\vec
x}S\right)\times\vec E_1\,,
\end{equation}
and in particular,
\begin{equation}\label{MaxVacFullPPNffGGhjhjhjhjkhkhljjkjnhhhhnhkhkhhjhjhjjhjgjlljkjuiyjjkljk}
\left(\frac{c}{c^2_0}\vec {\tilde u}-\nabla_{\vec
x}S\right)\cdot\vec E_1\approx 0\quad\quad\text{and}\quad \quad
\left(\frac{c}{c^2_0}\vec {\tilde u}-\nabla_{\vec x}S\right)\cdot
\vec H_1\approx 0.
\end{equation}
Thus, by inserting
\er{MaxVacFullPPNffGGhjhjhjhjkhkhljjkjnhhhhnhkhkhhjhjhjjhjgjlljkjuiy}
and
\er{MaxVacFullPPNffGGhjhjhjhjkhkhljjkjnhhhhnhkhkhhjhjhjjhjgjlljkjuiyjjkljk}
into
\er{MaxVacFullPPNmmmffffffiuiuhjuughbghhuiiujjnewjikhkjhjkkjhh}, and
using
\er{MaxVacFullPPNmmmffffffiuiuhjuughbghhiuijghghghhhfhhghghguygtjuuujjjkyuuykkhhk}
we finally deduce:
\begin{equation}\label{MaxVacFullPPNffGGhjhjhjhjkhkhljjkjnhhhhnhkhkhhjhjhjjhjgjlljkjuiyklk}
\begin{cases}
\vec H\approx\kappa_0
\vec h_0\times\vec E
\\
\vec E\approx
-\gamma_0\vec h_0\times\vec H\\
\vec h_0\cdot\vec E\approx 0\\
\vec h_0\cdot \vec H\approx 0.
\end{cases}
\end{equation}
So the vectors
$\vec h_0\,$, $\vec E$ and $\vec H$ form together rightly orientated
orthogonal system of vectors. We remind here again that $\vec h_0$
is the direction of the propagation of the ray.

\subsubsection{More delicate approximation of the wave
equation}\label{gjgggggggggggggggggggghkhk}
Again consider the
scalar wave equation of the form
\begin{equation}\label{MaxVacFullPPNmmmffffffiuiuhjuughbghhjjjoojouiuiu}
\left(\frac{\partial}{\partial
t}\left\{\frac{1}{c^2_0}\left(\frac{\partial U}{\partial t}+\vec
{\tilde u}\cdot\nabla_{\vec x}U\right)\right\}+\Div_{\vec x}
\left\{\frac{1}{c^2_0}\left(\frac{\partial U}{\partial t}+\vec
{\tilde u}\cdot\nabla_{\vec x} U\right)\vec {\tilde
u}\right\}\right)-\Delta_{\vec x}U=0.
\end{equation}
Above, we established $U(\vec x,t)$ only in the \underline{rough}
Geometric Optics approximation \er{slochangGGaaffgfgjhjggh11}. Since
we have \er{MaxVacFullPPNmmmffffffiuiuhjuughbghhuiiujjjjjjjjgjjg} in
the rough Geometric Optics approximation and $S(\vec x,t)$ satisfies
the following Eikonal equation
\begin{equation}\label{MaxVacFullPPNmmmffffffiuiuhjuughbghhiuijghghghhjhjhhghyuyjjjhhjhjffhjjhjoj}
\frac{1}{c^2_0}\left(\frac{\partial S}{\partial t}+\vec {\tilde
u}\cdot\nabla_\vec x S\right)^2=\left|\nabla_\vec x S\right|^2.
\end{equation} in the case of more precise
delicate Geometric Optics approximation, in order to get better
approximation for $U(\vec x,t)$ we can use the following alternative
consideration, slightly different than we did above: from now we
consider $U(\vec x,t)$ be of the form
\begin{equation}\label{MaxVacFullPPNmmmffffffiuiuhjuughbghhuiiujjjjjjjjgjjgiuiuui}
U(\vec x,t)=A(\vec x,t)e^{ik_0S(\vec x,t)}\,,
\end{equation}
similarly to
\er{MaxVacFullPPNmmmffffffiuiuhjuughbghhuiiujjjjjjjjgjjg}. However,
we consider that $S$ in
\er{MaxVacFullPPNmmmffffffiuiuhjuughbghhuiiujjjjjjjjgjjgiuiuui}
satisfied
\er{MaxVacFullPPNmmmffffffiuiuhjuughbghhiuijghghghhjhjhhghyuyjjjhhjhjffhjjhjoj}
\underline{precisely}, although we consider the amplitude $A$ to be
\underline{complex} rather than real! So we include the correction
term of $\kappa_0 S$ into the complex amplitude $A$! Thus either
\er{MaxVacFullPPNmmmffffffiuiuhjuughbghhiuijghghghhhfhhghghguygtjjjkkjjkkhh}
or
\er{MaxVacFullPPNmmmffffffiuiuhjuughbghhiuijghghghhhfhhghghguygtjjjkkjjkkhhklklkjk33hhjjjlj}
is not valid anymore, since it was derived under the assumption of
real amplitude $A$. However, either
\er{MaxVacFullPPNmmmffffffiuiuhjuughbghhiuijghghghhjhjhhghyuyjjjhhjhjff}
or
\er{MaxVacFullPPNmmmffffffiuiuhjuughbghhiuijghghghhhgjgfghjdfjgddg1133khhujuuuo99klklljljlj}
still holds. I.e. we have
\er{MaxVacFullPPNmmmffffffiuiuhjuughbghhiuijghghghhjhjhhghyuyjjjhhjhjffhjjhjoj},
which assumed to hold precisely. Moreover, by
\er{MaxVacFullPPNmmmffffffiuiuhjuughbghhiuijghghghhjhjhjjuiuiui}
together with
\er{MaxVacFullPPNmmmffffffiuiuhjuughbghhiuijghghghhjhjhhghyuyjjjhhjhjffhjjhjoj}
we have
\begin{multline}\label{MaxVacFullPPNmmmffffffiuiuhjuughbghhiuijghghghhjhjhjjuiuiuiuuiui}
\left(\frac{\partial}{\partial
t}\left\{\frac{1}{c^2_0}\left(\frac{\partial A}{\partial t}+\vec
{\tilde u}\cdot\nabla_{\vec x}A\right)\right\}+\Div_{\vec x}
\left\{\frac{1}{c^2_0}\left(\frac{\partial A}{\partial t}+\vec
{\tilde u}\cdot\nabla_{\vec x} A\right)\vec {\tilde
u}\right\}\right)-\Delta_{\vec
x}A\\
+i\kappa_0 A\left(\frac{\partial}{\partial
t}\left\{\frac{1}{c^2_0}\left(\frac{\partial S}{\partial t}+\vec
{\tilde u}\cdot\nabla_{\vec x}S\right)\right\}+\Div_{\vec x}
\left\{\frac{1}{c^2_0}\left(\frac{\partial S}{\partial t}+\vec
{\tilde u}\cdot\nabla_{\vec x} S\right)\vec {\tilde
u}\right\}-\Delta_{\vec
x}S\right)\\+2i\kappa_0\left(\frac{1}{c^2_0}\left(\frac{\partial
S}{\partial t}+\vec {\tilde u}\cdot\nabla_{\vec
x}S\right)\left(\frac{\partial A}{\partial t}+\vec {\tilde
u}\cdot\nabla_{\vec x}A\right)-\nabla_{\vec x}A\cdot\nabla_{\vec
x}S\right)=0.
\end{multline}
Then denoting, as before, the proper scalar field:
\begin{equation}\label{MaxVacFullPPNmmmffffffiuiuhjuughbghhiuijghghghhhgjgfghjdfjgddg1133khhujuuuo33}
\tilde\omega(\vec x,t):=\kappa_0\left(\frac{\partial S}{\partial
t}\left(\vec x,t\right)+\vec {\tilde u}\left(\vec
x,t\right)\cdot\nabla_{\vec x}S\left(\vec x,t\right)\right),
\end{equation}
the proper vector field:
\begin{equation}\label{MaxVacFullPPNmmmffffffiuiuhjuughbghhiuijghghghhhgjgfghjdfjgddg11335533}
\vec k(\vec x,t):=c_0(\vec x,t)\left(\frac{\partial S}{\partial
t}(\vec x,t)+\vec {\tilde u}(\vec x,t)\cdot\nabla_{\vec x}S(\vec
x,t)\right)^{-1}\nabla_{\vec x}S(\vec x,t),
\end{equation}
the speed-like vector field:
\begin{equation}\label{MaxVacFullPPNmmmffffffiuiuhjuughbghhiuijghghghhhgjgfghjdfjgddg113333}
\vec h(\vec x,t):=\vec {\tilde u}(\vec x,t)-c^2_0(\vec
x,t)\left(\frac{\partial S}{\partial t}(\vec x,t)+\vec {\tilde
u}(\vec x,t)\cdot\nabla_{\vec x}S(\vec x,t)\right)^{-1}\nabla_{\vec
x}S(\vec x,t)=\vec {\tilde u}(\vec x,t)-c_0(\vec x,t)\vec k(\vec
x,t),
\end{equation}
and the proper scalar field:
\begin{multline}\label{MaxVacFullPPNmmmffffffiuiuhjuughbghhiuijghghghhhfhhghghguygtjuuujjjkhjhjh11kklkloiouu3333}
G\left(\vec x,t\right):=\\
\left(\frac{c^2_0}{2\left(\frac{\partial S}{\partial t}+\vec {\tilde
u}\cdot\nabla_{\vec x}S\right)}\left(\Delta_{\vec
x}S-\frac{\partial}{\partial
t}\left\{\frac{1}{c^2_0}\left(\frac{\partial S}{\partial t}+\vec
{\tilde u}\cdot\nabla_{\vec x}S\right)\right\}-\Div_{\vec x}
\left\{\frac{1}{c^2_0}\left(\frac{\partial S}{\partial t}+\vec
{\tilde u}\cdot\nabla_{\vec x} S\right)\vec {\tilde
u}\right\}\right)\right)(\vec x,t)
\\=
\left(\frac{c^2_0}{2\tilde\omega}\left(\Div_{\vec
x}\left\{\frac{\tilde\omega}{c_0}\,\vec
k\right\}-\frac{\partial}{\partial
t}\left\{\frac{\tilde\omega}{c^2_0}\right\}-\Div_{\vec x}
\left\{\frac{\tilde\omega}{c^2_0}\,\vec {\tilde
u}\right\}\right)\right)(\vec x,t)\\=-
\left(\frac{c^2_0}{2\tilde\omega}\left(\frac{\partial}{\partial
t}\left\{\frac{\tilde\omega}{c^2_0}\right\}+\vec h\cdot\nabla_{\vec
x} \left\{\frac{\tilde\omega}{c^2_0}\,\right\}\right)\right)(\vec
x,t)-\frac{1}{2}\left(\Div_{\vec x}\vec h(\vec x,t)\right),
\end{multline}
we clearly rewrite
\er{MaxVacFullPPNmmmffffffiuiuhjuughbghhiuijghghghhjhjhjjuiuiuiuuiui}
as:
\begin{multline}\label{MaxVacFullPPNmmmffffffiuiuhjuughbghhiuijghghghhhfhhghghguygtjjjkkjjkkhhklklkjk33hhjjklik}
\frac{c^2_0}{2\tilde\omega}\left\{\frac{\partial}{\partial
t}\left\{\frac{1}{c^2_0}\left(\frac{\partial A}{\partial t}+\vec
{\tilde u}\cdot\nabla_{\vec x}A\right)\right\}+\Div_{\vec x}
\left\{\frac{1}{c^2_0}\left(\frac{\partial A}{\partial t}+\vec
{\tilde u}\cdot\nabla_{\vec x} A\right)\vec {\tilde
u}\right\}-\Delta_{\vec x}A\right\}\\+i\left(\frac{\partial
A}{\partial t}
+\vec h\cdot\nabla_{\vec x}A
-G\, A\right)=0
,
\end{multline}
and
\er{MaxVacFullPPNmmmffffffiuiuhjuughbghhiuijghghghhjhjhhghyuyjjjhhjhjffhjjhjoj},
as
\begin{equation}\label{MaxVacFullPPNmmmffffffiuiuhjuughbghhiuijghghghhhgjgfghjdfjgddg1133khhujuuuo99klklljljljhjhjhyf11}
\left|\vec k(\vec x,t)\right|^2=1\quad\quad\text{or
equivalently}\quad\quad \left|\vec h(\vec x,t)-\vec {\tilde u}(\vec
x,t)\right|^2=c^2_0(\vec x,t)\,.
\end{equation}
Moreover, as before, we obviously have
\begin{equation}\label{MaxVacFullPPNmmmffffffiuiuhjuughbghhiuijghghghhhgjgfghjdfjgddg1133khhujuuuo99klklljljljhjhjhyf35}
\frac{\tilde\omega^2}{c^2_0k_0^2}=\frac{1}{c^2_0}\left(\frac{\partial
S}{\partial t}+\tilde{\vec u}\cdot\nabla_{\vec
x}S\right)^2=\left|\nabla_{\vec x}S\right|^2,\quad\quad \left|\vec
k\right|^2=1\quad\quad\text{or equivalently}\quad\quad \left|\vec
h-\vec {\tilde u}\right|^2=c^2_0\,,
\end{equation}
\begin{equation}\label{MaxVacFullPPNmmmffffffiuiuhjuughbghhiuijghghghhhgjgfghjdfjgddg1133khhujuuuo33uuu35}
\frac{\partial S}{\partial t}+\vec {\tilde u}\cdot\nabla_{\vec
x}S=c_0\vec k\cdot\nabla_{\vec x}S\quad\quad \text{or
equivalently}\quad\quad\quad\quad \frac{\partial S}{\partial t}+\vec
h\cdot\nabla_{\vec x}S=0\,,
\end{equation}
\begin{equation}\label{MaxVacFullPPNmmmffffffiuiuhjuughbghhiuijghghghhhgjgfghjdfjgddg1133khhujuuuo99klklljljljhjhjhyfjkkjkhkiili35}
\nabla_{\vec x}S\,=\,\left(\vec k(\vec x,t)\cdot\nabla_{\vec
x}S\right)\,\vec k\,,
\end{equation}
\begin{equation}\label{MaxVacFullPPNmmmffffffiuiuhjuughbghhiuijghghghhhgjgfghjdfjgddg11335533jkjjkujjj35}
\frac{c_0}{\tilde\omega}\vec k=\frac{1}{k_0}\left|\nabla_{\vec
x}S\right|^{-2}\nabla_{\vec
x}S=\frac{c^2_0k_0}{\tilde\omega^2}\nabla_{\vec x}S,
\end{equation}
\begin{equation}\label{MaxVacFullPPNmmmffffffiuiuhjuughbghhiuijghghghhhgjgfghjdfjgddg1133khhujuuuo33uuuiujkjk35}
\frac{1}{k_0}\,\nabla_{\vec x}\tilde\omega=\frac{\partial}{\partial
t}\left\{\nabla_{\vec x}S\right\}+\nabla^2_{\vec x}S\cdot\vec
{\tilde u}+\left\{d_{\vec x}\vec {\tilde
u}\right\}^T\cdot\nabla_{\vec x}S\,,
\end{equation}
and
\begin{equation}\label{MaxVacFullPPNmmmffffffiuiuhjuughbghhiuijghghghhhgjgfghjdfjgddg1133khhujuuuo33uuuiujkjkuyuyiioiiuuiiiiu35}
\frac{1}{\tilde\omega}\left(\frac{\partial \tilde\omega}{\partial
t}+\vec h\cdot\nabla_{\vec
x}\tilde\omega\right)=\frac{1}{c_0}\left(\frac{\partial
c_0}{\partial t}+\vec {\tilde u}\cdot\nabla_{\vec
x}c_0\right)-\frac{1}{2}\left(\vec k\cdot\left(\left\{d_{\vec x}\vec
{\tilde u}\right\}^T+d_{\vec x}\vec {\tilde
u}\right)\right)\cdot\vec k\,.
\end{equation}

Furthermore, from now assume the case of \underline{delicate}
Geometric Optics approximation. Remind that the delicate Geometric
Optics approximation means the following: assume that the changes in
time of $c_0$, $\vec {\tilde u}$, $A$ and $S$ become essential after
certain interval of time $T_e$ and the changes in space of $c_0$,
$A$ and $S$ become essential in the spatial landscape $L_e$. Then we
assume that
\begin{equation}\label{slochangGGaaffgfgjhjggh99}
k^2_0c^2_0T^2_e\,\gg\, 1\quad\text{and}\quad k^2_0L^2_e\,\gg\, 1.
\end{equation}
Moreover, as before, we assume that the order of $c_0$ is less or
equal to the order of $c$. Then, estimation
\er{slochangGGaaffgfgjhjggh99} implies
\begin{multline}\label{MaxVacFullPPNmmmffffffhhtygghGGGGyuhggghghghffgiooiiojhjjhhjuiuiioiooikjiij77}
\frac{\left| d_{\vec x}\vec {\tilde u}+\left\{d_{\vec x}\vec {\tilde
u}\right\}^T\right|^2}{
c^2_0}+\frac{\left(| \nabla_{\vec
x}A|+\frac{1}{c_0}\left|\frac{\partial A}{\partial t}+\vec {\tilde
u}\cdot\nabla_{\vec x}A\right|\right)^2}{|A|^2}+\frac{\left(|
\nabla_{\vec x}c_0|+\frac{1}{c_0}\left|\frac{\partial c_0}{\partial
t}+\vec {\tilde u}\cdot\nabla_{\vec
x}c_0\right|\right)^2}{|c_0|^2}\\+\frac{ \left| \nabla^2_{\vec
x}S\right|^2+ \frac{1}{c^2_0}\left|\nabla_{\vec
x}\left(\frac{\partial S}{\partial t}+\vec {\tilde
u}\cdot\nabla_{\vec x}S\right)\right|^2 }{ \left| \nabla_{\vec
x}S\right|^2} \ll k^2_0 \quad\quad\text{and}\quad\quad
\frac{\left|\frac{\partial S}{\partial t}+\vec {\tilde
u}\cdot\nabla_{\vec x}S-c\right|^2}{c^2}\ll 1\,,
\end{multline}
so that we  have
\begin{multline}\label{MaxVacFullPPNmmmffffffhhtygghGGGGyuhggghghghffgiooiiojhjjhhjuiuiioiooikjiij}
\frac{\left| d_{\vec x}\vec {\tilde u}+\left\{d_{\vec x}\vec {\tilde
u}\right\}^T\right|^2}{
c^2_0}+\frac{\left(| \nabla_{\vec
x}A|+\frac{1}{c_0}\left|\frac{\partial A}{\partial t}+\vec {\tilde
u}\cdot\nabla_{\vec x}A\right|\right)^2}{|A|^2}+\frac{\left(|
\nabla_{\vec x}c_0|+\frac{1}{c_0}\left|\frac{\partial c_0}{\partial
t}+\vec {\tilde u}\cdot\nabla_{\vec
x}c_0\right|\right)^2}{|c_0|^2}\\+\frac{ \left| \nabla^2_{\vec
x}S\right|^2+ \frac{1}{c^2_0}\left|\nabla_{\vec
x}\left(\frac{\partial S}{\partial t}+\vec {\tilde
u}\cdot\nabla_{\vec x}S\right)\right|^2 }{ \left| \nabla_{\vec
x}S\right|^2}  \ll
\frac{\tilde\omega^2}{c^2}\leq\frac{\tilde\omega^2}{c^2_0}\,.
\end{multline}
Moreover, by
\er{MaxVacFullPPNmmmffffffhhtygghGGGGyuhggghghghffgiooiiojhjjhhjuiuiioiooikjiij}
together with
\er{MaxVacFullPPNmmmffffffiuiuhjuughbghhiuijghghghhhgjgfghjdfjgddg1133khhujuuuo33uuuiujkjkuyuyiioiiuuiiiiu35}
we also have
\begin{equation}\label{MaxVacFullPPNmmmffffffhhtygghGGGGyuhggghghghffgiooiiojhjjhhjuiuiioiooiuyuhhjhj}
\frac{|\Div_{\vec x}\vec h|^2}{c^2_0}+\frac{\left(| d_{\vec x}\vec
k|+\frac{1}{c_0}\left|\frac{\partial \vec k}{\partial t}+d_{\vec
x}\vec k\cdot\vec {\tilde u}-d_{\vec x}\vec {\tilde u}\cdot\vec
k\right|\right)^2}{|\vec k|^2}+\frac{\left(| \nabla_{\vec
x}\tilde\omega|+\frac{1}{c_0}\left|\frac{\partial
\tilde\omega}{\partial t}+\vec {\tilde u}\cdot\nabla_{\vec
x}\tilde\omega\right|\right)^2}{|\tilde\omega|^2}\ll
\frac{\tilde\omega^2}{c^2_0}\,.
\end{equation}
Therefore, denoting
\begin{equation}\label{MaxVaijkjjk1}
H:=\frac{\partial A}{\partial t}
+\vec h\cdot\nabla_{\vec x}A
-G\,A=\frac{\partial A}{\partial t}+\vec {\tilde u}\cdot\nabla_{\vec
x}A-c_0\vec k\cdot\nabla_{\vec x}A-G\,A
\end{equation}
we rewrite
\er{MaxVacFullPPNmmmffffffiuiuhjuughbghhiuijghghghhhfhhghghguygtjjjkkjjkkhhklklkjk33hhjjklik}
as:
\begin{multline}\label{MaxVacFullPPNmmmffffffiuiuhjuughbghhiuijghghghhhfhhghghguygtjjjkkjjkkhhklklkjk33hhjjklikjjkjkjjjkoi1}
\frac{c^2_0}{2\tilde\omega}\left\{\frac{\partial }{\partial
t}\left(\frac{1}{c^2_0}H\right)+\Div_{\vec x}
\left(\frac{1}{c^2_0}H\vec {\tilde u}\right)\right\}
\\
+\frac{c^2_0}{2\tilde\omega}\left\{\frac{\partial}{\partial
t}\left\{\frac{1}{c^2_0}\left(c_0\vec k\cdot\nabla_{\vec
x}A+G\,A\right)\right\}+\Div_{\vec x}
\left\{\frac{1}{c^2_0}\left(c_0\vec k\cdot\nabla_{\vec
x}A+G\,A\right)\vec {\tilde u}\right\}\right\}
-\frac{c^2_0}{2\tilde\omega}\left(\Delta_{\vec x}A\right)=-iH\,
.
\end{multline}
However, by
\er{MaxVacFullPPNmmmffffffhhtygghGGGGyuhggghghghffgiooiiojhjjhhjuiuiioiooikjiij}
and
\er{MaxVacFullPPNmmmffffffhhtygghGGGGyuhggghghghffgiooiiojhjjhhjuiuiioiooiuyuhhjhj}
we have
\begin{equation}\label{MaxVacFullPPNmmmffffffhhtygghGGGGyuhggghghghffgiooiiojhjjhhjuiuiioiooiuyuhhjhjjhjh13}
\frac{\left(| \nabla_{\vec x}H|+\frac{1}{c_0}\left|\frac{\partial
H}{\partial t}+\vec {\tilde u}\cdot\nabla_{\vec
x}H\right|\right)^2}{|H|^2}\ll \frac{\tilde\omega^2}{c^2_0}\,.
\end{equation}
In particular, by
\er{MaxVacFullPPNmmmffffffiuiuhjuughbghhiuijghghghhhfhhghghguygtjjjkkjjkkhhklklkjk33hhjjklikjjkjkjjjkoi1},
\er{MaxVacFullPPNmmmffffffhhtygghGGGGyuhggghghghffgiooiiojhjjhhjuiuiioiooikjiij},
\er{MaxVacFullPPNmmmffffffhhtygghGGGGyuhggghghghffgiooiiojhjjhhjuiuiioiooiuyuhhjhj}
and
\er{MaxVacFullPPNmmmffffffhhtygghGGGGyuhggghghghffgiooiiojhjjhhjuiuiioiooiuyuhhjhjjhjh13}
we deduce
\begin{equation}\label{MaxVacFullPPNmmmffffffhhtygghGGGGyuhggghghghffgiooiiojhjjhhjuiuiioiooiuyuhhjhjjhjh11}
H^2\ll
\left|c_0\vec k\cdot\nabla_{\vec x}A
+G\,A\right|^2
\end{equation}
Thus, by
\er{MaxVacFullPPNmmmffffffhhtygghGGGGyuhggghghghffgiooiiojhjjhhjuiuiioiooikjiij},
\er{MaxVacFullPPNmmmffffffhhtygghGGGGyuhggghghghffgiooiiojhjjhhjuiuiioiooiuyuhhjhj},
\er{MaxVacFullPPNmmmffffffhhtygghGGGGyuhggghghghffgiooiiojhjjhhjuiuiioiooiuyuhhjhjjhjh13}
and
\er{MaxVacFullPPNmmmffffffhhtygghGGGGyuhggghghghffgiooiiojhjjhhjuiuiioiooiuyuhhjhjjhjh11}
we approximate
\er{MaxVacFullPPNmmmffffffiuiuhjuughbghhiuijghghghhhfhhghghguygtjjjkkjjkkhhklklkjk33hhjjklikjjkjkjjjkoi1}
as:
\begin{multline}\label{MaxVacFullPPNmmmffffffiuiuhjuughbghhiuijghghghhhfhhghghguygtjjjkkjjkkhhklklkjk33hhjjklikjjkjkjjjkoi;;lll177}
\frac{c^2_0}{2\tilde\omega}\left\{\frac{\partial}{\partial
t}\left\{\frac{1}{c^2_0}\left(c_0\vec k\cdot\nabla_{\vec
x}A+G\,A\right)\right\}+\Div_{\vec x}
\left\{\frac{1}{c^2_0}\left(c_0\vec k\cdot\nabla_{\vec
x}A+G\,A\right)\vec {\tilde u}\right\}\right\}
-\frac{c^2_0}{2\tilde\omega}\left(\Delta_{\vec x}A\right)\approx-iH=\\
-i\left\{\frac{\partial A}{\partial t}
+\vec h\cdot\nabla_{\vec x}A
+\frac{1}{2}\,\left(\Div_{\vec x}\vec h \right)A\right\}-
iA\left(\frac{c_0}{\sqrt{\tilde\omega}}\left(\frac{\partial}{\partial
t}\left\{\frac{\sqrt{\tilde\omega}}{c_0}\right\}+\vec
h\cdot\nabla_{\vec x}
\left\{\frac{\sqrt{\tilde\omega}}{c_0}\,\right\}\right)\right).
\end{multline}
We rewrite it as
\begin{multline}\label{MaxVacFullPPNmmmffffffiuiuhjuughbghhiuijghghghhhfhhghghguygtjjjkkjjkkhhklklkjk33hhjjklikjjkjkjjjkoi;;lll1}
\frac{c^2_0}{2\tilde\omega}\nabla_{\vec
x}A\cdot\left\{\frac{\partial}{\partial t}\left\{\frac{1}{c_0}\vec
k\right\}+\Div_{\vec x} \left\{\frac{1}{c_0}\vec k\otimes\vec
{\tilde u}\right\}-d_{\vec x}\vec {\tilde
u}\cdot\left(\frac{1}{c_0}\vec
k\right)\right\}+\frac{1}{2\tilde\omega}c_0\vec k\cdot\nabla_{\vec
x}\left(\frac{\partial A}{\partial t}+\vec {\tilde
u}\cdot\nabla_{\vec x}A\right)
\\+\frac{1}{2\tilde\omega}G\left(\frac{\partial A}{\partial t}+\vec
{\tilde u}\cdot\nabla_{\vec x}A\right)
+\frac{c^2_0}{2\tilde\omega}A\left(\frac{\partial}{\partial
t}\left\{\frac{1}{c^2_0}G\right\}+\Div_{\vec x}
\left\{\frac{1}{c^2_0}G\vec {\tilde u}\right\}\right)
-\frac{c^2_0}{2\tilde\omega}\left(\Delta_{\vec x}A\right)\approx-iH=\\
-i\left\{\frac{\partial A}{\partial t}
+\vec h\cdot\nabla_{\vec x}A
+\frac{1}{2}\,\left(\Div_{\vec x}\vec h \right)A\right\}-
iA\left(\frac{c_0}{\sqrt{\tilde\omega}}\left(\frac{\partial}{\partial
t}\left\{\frac{\sqrt{\tilde\omega}}{c_0}\right\}+\vec
h\cdot\nabla_{\vec x}
\left\{\frac{\sqrt{\tilde\omega}}{c_0}\,\right\}\right)\right).
\end{multline}
Therefore,
\begin{multline}\label{MaxVacFullPPNmmmffffffiuiuhjuughbghhiuijghghghhhfhhghghguygtjjjkkjjkkhhklklkjk33hhjjklikjjkjkjjjkoi;;lll1jkkjkj1}
\frac{c^2_0}{2\tilde\omega}\nabla_{\vec
x}A\cdot\left\{\frac{\partial}{\partial t}\left\{\frac{1}{c_0}\vec
k\right\}+\Div_{\vec x} \left\{\frac{1}{c_0}\vec k\otimes\vec
{\tilde u}\right\}-d_{\vec x}\vec {\tilde
u}\cdot\left(\frac{1}{c_0}\vec
k\right)\right\}\\+\frac{1}{2\tilde\omega}c_0\vec k\cdot\nabla_{\vec
x}H +\frac{1}{2\tilde\omega}G\,H +\frac{1}{2\tilde\omega}c_0\vec
k\cdot\nabla_{\vec x}\left(c_0\vec k\cdot\nabla_{\vec
x}A+G\,A\right)+\frac{1}{2\tilde\omega}G\,\left(c_0\vec
k\cdot\nabla_{\vec x}A+G\,A\right)\\
+\frac{c^2_0}{2\tilde\omega}A\left(\frac{\partial}{\partial
t}\left\{\frac{1}{c^2_0}G\right\}+\Div_{\vec x}
\left\{\frac{1}{c^2_0}G\vec {\tilde u}\right\}\right)
-\frac{c^2_0}{2\tilde\omega}\left(\Delta_{\vec x}A\right)\approx-iH=\\
-i\left\{\frac{\partial A}{\partial t}
+\vec h\cdot\nabla_{\vec x}A
+\frac{1}{2}\,\left(\Div_{\vec x}\vec h \right)A\right\}-
iA\left(\frac{c_0}{\sqrt{\tilde\omega}}\left(\frac{\partial}{\partial
t}\left\{\frac{\sqrt{\tilde\omega}}{c_0}\right\}+\vec
h\cdot\nabla_{\vec x}
\left\{\frac{\sqrt{\tilde\omega}}{c_0}\,\right\}\right)\right).
\end{multline}
Thus, by
\er{MaxVacFullPPNmmmffffffhhtygghGGGGyuhggghghghffgiooiiojhjjhhjuiuiioiooiuyuhhjhj},
\er{MaxVacFullPPNmmmffffffhhtygghGGGGyuhggghghghffgiooiiojhjjhhjuiuiioiooiuyuhhjhjjhjh13}
and
\er{MaxVacFullPPNmmmffffffhhtygghGGGGyuhggghghghffgiooiiojhjjhhjuiuiioiooiuyuhhjhjjhjh11}
we further approximate
\er{MaxVacFullPPNmmmffffffiuiuhjuughbghhiuijghghghhhfhhghghguygtjjjkkjjkkhhklklkjk33hhjjklikjjkjkjjjkoi;;lll1jkkjkj1}
as
\begin{multline}\label{MaxVacFullPPNmmmffffffiuiuhjuughbghhiuijghghghhhfhhghghguygtjjjkkjjkkhhklklkjk33hhjjklikjjkjkjjjkoi;;lll1jkkjkj}
\frac{c^2_0}{2\tilde\omega}\nabla_{\vec
x}A\cdot\left\{\frac{\partial}{\partial t}\left\{\frac{1}{c_0}\vec
k\right\}+\Div_{\vec x} \left\{\frac{1}{c_0}\vec k\otimes\vec
{\tilde u}\right\}-d_{\vec x}\vec {\tilde
u}\cdot\left(\frac{1}{c_0}\vec k\right)\right\}\\
+\frac{1}{2\tilde\omega}c_0\vec k\cdot\nabla_{\vec x}\left(c_0\vec
k\cdot\nabla_{\vec
x}A+G\,A\right)+\frac{1}{2\tilde\omega}G\,\left(c_0\vec
k\cdot\nabla_{\vec x}A+G\,A\right)\\
+\frac{c^2_0}{2\tilde\omega}A\left(\frac{\partial}{\partial
t}\left\{\frac{1}{c^2_0}G\right\}+\Div_{\vec x}
\left\{\frac{1}{c^2_0}G\vec {\tilde u}\right\}\right)
-\frac{c^2_0}{2\tilde\omega}\left(\Delta_{\vec x}A\right)\approx\\
-i\left\{\frac{\partial A}{\partial t}
+\vec h\cdot\nabla_{\vec x}A
+\frac{1}{2}\,\left(\Div_{\vec x}\vec h \right)A\right\}-
iA\left(\frac{c_0}{\sqrt{\tilde\omega}}\left(\frac{\partial}{\partial
t}\left\{\frac{\sqrt{\tilde\omega}}{c_0}\right\}+\vec
h\cdot\nabla_{\vec x}
\left\{\frac{\sqrt{\tilde\omega}}{c_0}\,\right\}\right)\right).
\end{multline}
We rewrite the last equation as:
\begin{multline}\label{MaxVacFullPPNmmmffffffiuiuhjuughbghhiuijghghghhhfhhghghguygtjjjkkjjkkhhklklkjk33hhjjklikjjkjkjjjkoi;;lll1jkkjkjjljjjk77}
\frac{c^2_0}{2\tilde\omega}\nabla_{\vec
x}A\cdot\left\{\frac{\partial}{\partial t}\left\{\frac{1}{c_0}\vec
k\right\}+\Div_{\vec x} \left\{\frac{1}{c_0}\vec k\otimes\vec
{\tilde u}\right\}-d_{\vec x}\vec {\tilde
u}\cdot\left(\frac{1}{c_0}\vec k\right)\right\}\\
+\frac{c^2_0}{2\tilde\omega}\left(\frac{1}{c_0}\vec
k\right)\cdot\nabla_{\vec x}\left(c_0\vec k\cdot\nabla_{\vec
x}A\right)+\frac{1}{\tilde\omega}G\,\left(c_0\vec
k\cdot\nabla_{\vec x}A\right)\\
+\frac{c^2_0}{2\tilde\omega}A\left(\frac{\partial}{\partial
t}\left\{\frac{1}{c^2_0}G\right\}+\Div_{\vec x}
\left\{\frac{1}{c^2_0}G\vec {\tilde
u}\right\}+\frac{1}{c^2_0}(c_0\vec k)\cdot\nabla_{\vec
x}G+\frac{1}{c^2_0}G^2\right)
-\frac{c^2_0}{2\tilde\omega}\left(\Delta_{\vec x}A\right)\approx\\
-i\left\{\frac{\partial A}{\partial t}
+\vec h\cdot\nabla_{\vec x}A
+\frac{1}{2}\,\left(\Div_{\vec x}\vec h \right)A\right\}-
iA\left(\frac{c_0}{\sqrt{\tilde\omega}}\left(\frac{\partial}{\partial
t}\left\{\frac{\sqrt{\tilde\omega}}{c_0}\right\}+\vec
h\cdot\nabla_{\vec x}
\left\{\frac{\sqrt{\tilde\omega}}{c_0}\,\right\}\right)\right),
\end{multline}
and further as
\begin{multline}\label{MaxVacFullPPNmmmffffffiuiuhjuughbghhiuijghghghhhfhhghghguygtjjjkkjjkkhhklklkjk33hhjjklikjjkjkjjjkoi;;lll1jkkjkjjljjjk}
\frac{c^2_0}{2\tilde\omega}\nabla_{\vec
x}A\cdot\left\{\frac{\partial}{\partial t}\left\{\frac{1}{c_0}\vec
k\right\}+\Div_{\vec x} \left\{\frac{1}{c_0}\vec k\otimes\vec
{\tilde u}\right\}-d_{\vec x}\vec {\tilde
u}\cdot\left(\frac{1}{c_0}\vec k\right)\right\}\\
+\frac{c^2_0}{2\tilde\omega}\left(\frac{1}{c_0}\vec
k\cdot\nabla_{\vec x}c_0\right)\left(\vec k\cdot\nabla_{\vec
x}A\right)+\frac{c^2_0}{2\tilde\omega}\nabla_{\vec x}\left(\vec
k\cdot\nabla_{\vec x}A\right)\cdot\vec
k+\frac{1}{\tilde\omega}G\,\left(c_0\vec
k\cdot\nabla_{\vec x}A\right)\\
+\frac{c^2_0}{2\tilde\omega}A\left(\frac{\partial}{\partial
t}\left\{\frac{1}{c^2_0}G\right\}+\Div_{\vec x}
\left\{\frac{1}{c^2_0}G\vec {\tilde
u}\right\}+\frac{1}{c^2_0}(c_0\vec k)\cdot\nabla_{\vec
x}G+\frac{1}{c^2_0}G^2\right)
-\frac{c^2_0}{2\tilde\omega}\left(\Delta_{\vec x}A\right)\approx\\
-i\left\{\frac{\partial A}{\partial t}
+\vec h\cdot\nabla_{\vec x}A
+\frac{1}{2}\,\left(\Div_{\vec x}\vec h \right)A\right\}-
iA\left(\frac{c_0}{\sqrt{\tilde\omega}}\left(\frac{\partial}{\partial
t}\left\{\frac{\sqrt{\tilde\omega}}{c_0}\right\}+\vec
h\cdot\nabla_{\vec x}
\left\{\frac{\sqrt{\tilde\omega}}{c_0}\,\right\}\right)\right),
\end{multline}
Therefore, we infer
\begin{multline}\label{MaxVacFullPPNmmmffffffiuiuhjuughbghhiuijghghghhhfhhghghguygtjjjkkjjkkhhklklkjk33hhjjklikjjkjkjjjkoi;;lll1jkkjkjjljjjkjjkjk}
\frac{c^2_0}{2\tilde\omega}\nabla_{\vec
x}A\cdot\left\{\frac{\partial}{\partial t}\left\{\frac{1}{c_0}\vec
k\right\}+\Div_{\vec x} \left\{\frac{1}{c_0}\vec k\otimes\vec
{\tilde u}\right\}-d_{\vec x}\vec {\tilde
u}\cdot\left(\frac{1}{c_0}\vec k\right)\right\}\\
+\frac{c^2_0}{2\tilde\omega}\left(\frac{1}{c_0}\vec
k\cdot\nabla_{\vec x}c_0\right)\left(\vec k\cdot\nabla_{\vec
x}A\right)-\frac{c^2_0}{2\tilde\omega}\left(\vec k\cdot\nabla_{\vec
x}A\right)\left(\Div_{\vec x}\vec
k\right)+\frac{1}{\tilde\omega}G\,\left(c_0\vec
k\cdot\nabla_{\vec x}A\right)\\
+\frac{c^2_0}{2\tilde\omega}A\left(\frac{\partial}{\partial
t}\left\{\frac{1}{c^2_0}G\right\}+\Div_{\vec x}
\left\{\frac{1}{c^2_0}G\vec {\tilde
u}\right\}+\frac{1}{c^2_0}(c_0\vec k)\cdot\nabla_{\vec
x}G+\frac{1}{c^2_0}G^2\right)
-\frac{c^2_0}{2\tilde\omega}\left(\Div_{\vec x}\left\{\nabla_{\vec
x}A-\left(\vec
k\cdot\nabla_{\vec x}A\right)\vec k\right\}\right)\approx\\
-i\left\{\frac{\partial A}{\partial t}
+\vec h\cdot\nabla_{\vec x}A
+\frac{1}{2}\,\left(\Div_{\vec x}\vec h \right)A\right\}-
iA\left(\frac{c_0}{\sqrt{\tilde\omega}}\left(\frac{\partial}{\partial
t}\left\{\frac{\sqrt{\tilde\omega}}{c_0}\right\}+\vec
h\cdot\nabla_{\vec x}
\left\{\frac{\sqrt{\tilde\omega}}{c_0}\,\right\}\right)\right),
\end{multline}
So,
\begin{multline}\label{MaxVacFullPPNmmmffffffiuiuhjuughbghhiuijghghghhhfhhghghguygtjjjkkjjkkhhklklkjk33hhjjklikjjkjkjjjkoi;;lll1jkkjkjjljjjkjjkjkjjjkk}
\frac{c^2_0}{2\tilde\omega}\nabla_{\vec
x}A\cdot\left\{\frac{\partial}{\partial t}\left\{\frac{1}{c_0}\vec
k\right\}+\Div_{\vec x} \left\{\frac{1}{c_0}\vec k\otimes\vec
{\tilde u}\right\}-d_{\vec x}\vec {\tilde
u}\cdot\left(\frac{1}{c_0}\vec k\right)-\left(\Div_{\vec x}\left\{\frac{1}{c_0}\vec k\right\}\right)(c_0\vec k)+2G\left(\frac{1}{c_0}\vec k\right)\right\}\\
+\frac{c^2_0}{2\tilde\omega}A\left(\frac{\partial}{\partial
t}\left\{\frac{1}{c^2_0}G\right\}+\Div_{\vec x}
\left\{\frac{1}{c^2_0}G\vec {\tilde
u}\right\}+\frac{1}{c^2_0}(c_0\vec k)\cdot\nabla_{\vec
x}G+\frac{1}{c^2_0}G^2\right)
-\frac{c^2_0}{2\tilde\omega}\left(\Div_{\vec x}\left\{\nabla_{\vec
x}A-\left(\vec
k\cdot\nabla_{\vec x}A\right)\vec k\right\}\right)\approx\\
-i\left\{\frac{\partial A}{\partial t}
+\vec h\cdot\nabla_{\vec x}A
+\frac{1}{2}\,\left(\Div_{\vec x}\vec h \right)A\right\}-
iA\left(\frac{c_0}{\sqrt{\tilde\omega}}\left(\frac{\partial}{\partial
t}\left\{\frac{\sqrt{\tilde\omega}}{c_0}\right\}+\vec
h\cdot\nabla_{\vec x}
\left\{\frac{\sqrt{\tilde\omega}}{c_0}\,\right\}\right)\right),
\end{multline}
On the other hand by
\er{MaxVacFullPPNmmmffffffiuiuhjuughbghhiuijghghghhhgjgfghjdfjgddg11335533jkjjkujjj35}
we have
\begin{equation}\label{MaxVacFullPPNmmmffffffiuiuhjuughbghhiuijghghghhhgjgfghjdfjgddg11335533jkjjkujjj35hhh}
\frac{\tilde\omega}{k_0c_0}\vec k=\nabla_{\vec x}S,
\end{equation}
Thus,
\begin{multline}\label{MaxVacFullPPNmmmffffffiuiuhjuughbghhiuijghghghhhfhhghghguygtjjjkkjjkkhhklklkjk33hhjjklikjjkjkjjjkoi;;lll1jkkjkjjljjjkjjkjkjjjkkjjjuiuuiuiuuikjkj77}
\frac{\tilde\omega}{k_0}\left\{\frac{\partial}{\partial
t}\left\{\frac{1}{c_0}\vec k\right\}+\Div_{\vec x}
\left\{\frac{1}{c_0}\vec k\otimes\vec {\tilde u}\right\}-d_{\vec
x}\vec {\tilde u}\cdot\left(\frac{1}{c_0}\vec
k\right)-\left(\Div_{\vec x}\left\{\frac{1}{c_0}\vec
k\right\}\right)(c_0\vec k)+2G\left(\frac{1}{c_0}\vec
k\right)\right\}\\= \left\{\frac{\partial}{\partial
t}\left\{\nabla_{\vec x}S\right\}+\Div_{\vec x} \left\{\nabla_{\vec
x}S\otimes\vec {\tilde u}\right\}-d_{\vec x}\vec {\tilde
u}\cdot\left(\nabla_{\vec x}S\right)-\left(\Div_{\vec
x}\left\{\nabla_{\vec x}S\right\}\right)(c_0\vec
k)+2G\left(\nabla_{\vec x}S\right)\right\}\\-
\frac{1}{c_0}\left\{\frac{\partial}{\partial
t}\left\{\frac{\tilde\omega}{k_0}\right\}+\vec {\tilde
u}\cdot\nabla_{\vec x}
\left\{\frac{\tilde\omega}{k_0}\right\}-(c_0\vec k)\cdot\nabla_{\vec
x} \left\{\frac{\tilde\omega}{k_0}\right\}\right\}\vec
k=\\
\nabla_{\vec x}\left(\frac{\partial S}{\partial t}+\vec {\tilde
u}\cdot\nabla_{\vec x}S\right)+\left(\Div_{\vec x} \left\{\vec
{\tilde u}\right\}\right)\nabla_{\vec x}S-\left(d_{\vec x}\vec
{\tilde u}+\left\{d_{\vec x}\vec {\tilde
u}\right\}^T\right)\cdot\left(\nabla_{\vec
x}S\right)-\left(\Div_{\vec x}\left\{\nabla_{\vec
x}S\right\}\right)(c_0\vec k)+2G\left(\nabla_{\vec x}S\right)\\-
\frac{1}{c_0}\left\{\frac{\partial}{\partial
t}\left\{\frac{\tilde\omega}{k_0}\right\}+\vec h\cdot\nabla_{\vec x}
\left\{\frac{\tilde\omega}{k_0}\right\}\right\}\vec k.
\end{multline}
I.e.
\begin{multline}\label{MaxVacFullPPNmmmffffffiuiuhjuughbghhiuijghghghhhfhhghghguygtjjjkkjjkkhhklklkjk33hhjjklikjjkjkjjjkoi;;lll1jkkjkjjljjjkjjkjkjjjkkjjjuiuuiuiuuikjkj}
\tilde\omega\left\{\frac{\partial}{\partial
t}\left\{\frac{1}{c_0}\vec k\right\}+\Div_{\vec x}
\left\{\frac{1}{c_0}\vec k\otimes\vec {\tilde u}\right\}-d_{\vec
x}\vec {\tilde u}\cdot\left(\frac{1}{c_0}\vec
k\right)-\left(\Div_{\vec x}\left\{\frac{1}{c_0}\vec
k\right\}\right)(c_0\vec k)+2G\left(\frac{1}{c_0}\vec
k\right)\right\}=\\
\nabla_{\vec x}\tilde\omega+\left(\Div_{\vec x} \left\{\vec {\tilde
u}\right\}\right)\left(\frac{\tilde\omega}{c_0}\vec
k\right)-\left(d_{\vec x}\vec {\tilde u}+\left\{d_{\vec x}\vec
{\tilde u}\right\}^T\right)\cdot\left(\frac{\tilde\omega}{c_0}\vec
k\right)-\left(\Div_{\vec x}\left\{\frac{\tilde\omega}{c_0}\vec
k\right\}\right)(c_0\vec k)+2G\left(\frac{\tilde\omega}{c_0}\vec
k\right)\\- \frac{1}{c_0}\left\{\frac{\partial\tilde\omega}{\partial
t}+\vec h\cdot\nabla_{\vec x} \tilde\omega\right\}\vec
k=\nabla_{\vec
x}\tilde\omega-\left(\vec k\cdot\nabla_{\vec x}\tilde\omega\right)\vec k\\
+\tilde\omega\left(\Div_{\vec x} \left\{\vec {\tilde
u}\right\}\right)\left(\frac{1}{c_0}\vec
k\right)-\tilde\omega\left(d_{\vec x}\vec {\tilde u}+\left\{d_{\vec
x}\vec {\tilde u}\right\}^T\right)\cdot\left(\frac{1}{c_0}\vec
k\right)-\tilde\omega\left(c^2_0\,\Div_{\vec
x}\left\{\frac{1}{c_0}\vec k\right\}\right)\left(\frac{1}{c_0}\vec
k\right)+2G\left(\frac{\tilde\omega}{c_0}\vec k\right)\\-
\frac{1}{c_0}\left\{\frac{\partial\tilde\omega}{\partial t}+\vec
h\cdot\nabla_{\vec x} \tilde\omega\right\}\vec k =\nabla_{\vec
x}\tilde\omega-\left(\vec k\cdot\nabla_{\vec x}\tilde\omega\right)\vec k\\
+\tilde\omega\left(\Div_{\vec x} \vec
h\right)\left(\frac{1}{c_0}\vec k\right)-\tilde\omega\left(d_{\vec
x}\vec {\tilde u}+\left\{d_{\vec x}\vec {\tilde
u}\right\}^T\right)\cdot\left(\frac{1}{c_0}\vec
k\right)+2\tilde\omega\left(\vec k\cdot\nabla_{\vec
x}c_0\right)\left(\frac{1}{c_0}\vec
k\right)+2G\left(\frac{\tilde\omega}{c_0}\vec k\right)\\-
\frac{1}{c_0}\left\{\frac{\partial\tilde\omega}{\partial t}+\vec
h\cdot\nabla_{\vec x} \tilde\omega\right\}\vec k.
\end{multline}
However, we remind that by
\er{MaxVacFullPPNmmmffffffiuiuhjuughbghhiuijghghghhhfhhghghguygtjuuujjjkhjhjh11kklkloiouu3333}
we have
\begin{equation}\label{MaxVacFullPPNmmmffffffiuiuhjuughbghhiuijghghghhhfhhghghguygtjuuujjjkhjhjh11kklkloiouu3333jhjhhjhj}
2G=- \left(\frac{c^2_0}{\tilde\omega}\left(\frac{\partial}{\partial
t}\left\{\frac{\tilde\omega}{c^2_0}\right\}+\vec h\cdot\nabla_{\vec
x}
\left\{\frac{\tilde\omega}{c^2_0}\,\right\}\right)\right)-\left(\Div_{\vec
x}\vec h\right),
\end{equation}
Thus, inserting
\er{MaxVacFullPPNmmmffffffiuiuhjuughbghhiuijghghghhhfhhghghguygtjuuujjjkhjhjh11kklkloiouu3333jhjhhjhj}
into
\er{MaxVacFullPPNmmmffffffiuiuhjuughbghhiuijghghghhhfhhghghguygtjjjkkjjkkhhklklkjk33hhjjklikjjkjkjjjkoi;;lll1jkkjkjjljjjkjjkjkjjjkkjjjuiuuiuiuuikjkj}
gives
\begin{multline}\label{MaxVacFullPPNmmmffffffiuiuhjuughbghhiuijghghghhhfhhghghguygtjjjkkjjkkhhklklkjk33hhjjklikjjkjkjjjkoi;;lll1jkkjkjjljjjkjjkjkjjjkkjjjuiuuiuiuuikjkjjij77}
\tilde\omega\left\{\frac{\partial}{\partial
t}\left\{\frac{1}{c_0}\vec k\right\}+\Div_{\vec x}
\left\{\frac{1}{c_0}\vec k\otimes\vec {\tilde u}\right\}-d_{\vec
x}\vec {\tilde u}\cdot\left(\frac{1}{c_0}\vec
k\right)-\left(\Div_{\vec x}\left\{\frac{1}{c_0}\vec
k\right\}\right)(c_0\vec k)+2G\left(\frac{1}{c_0}\vec
k\right)\right\}=\\
\nabla_{\vec x}\tilde\omega-\left(\vec k\cdot\nabla_{\vec
x}\tilde\omega\right)\vec k\\-\tilde\omega\left(d_{\vec x}\vec
{\tilde u}+\left\{d_{\vec x}\vec {\tilde
u}\right\}^T\right)\cdot\left(\frac{1}{c_0}\vec
k\right)+2\tilde\omega\left(\vec k\cdot\nabla_{\vec
x}c_0\right)\left(\frac{1}{c_0}\vec k\right)-
\left(c_0\left(\frac{\partial}{\partial
t}\left\{\frac{\tilde\omega}{c^2_0}\right\}+\vec h\cdot\nabla_{\vec
x} \left\{\frac{\tilde\omega}{c^2_0}\,\right\}\right)\right)\vec
k\\- \frac{1}{c_0}\left\{\frac{\partial\tilde\omega}{\partial
t}+\vec h\cdot\nabla_{\vec x} \tilde\omega\right\}\vec k=
\nabla_{\vec x}\tilde\omega-\left(\vec k\cdot\nabla_{\vec
x}\tilde\omega\right)\vec k\\-\tilde\omega\left(d_{\vec x}\vec
{\tilde u}+\left\{d_{\vec x}\vec {\tilde
u}\right\}^T\right)\cdot\left(\frac{1}{c_0}\vec
k\right)+2\tilde\omega\left(\vec k\cdot\nabla_{\vec
x}c_0\right)\left(\frac{1}{c_0}\vec k\right)+
2\tilde\omega\left(\frac{1}{c_0}\left(\frac{\partial c_0}{\partial
t}+\vec h\cdot\nabla_{\vec x}
c_0\right)\right)\left(\frac{1}{c_0}\vec k\right)\\-
\frac{2}{c_0}\left\{\frac{\partial\tilde\omega}{\partial t}+\vec
h\cdot\nabla_{\vec x} \tilde\omega\right\}\vec k.
\end{multline}
So,
\begin{multline}\label{MaxVacFullPPNmmmffffffiuiuhjuughbghhiuijghghghhhfhhghghguygtjjjkkjjkkhhklklkjk33hhjjklikjjkjkjjjkoi;;lll1jkkjkjjljjjkjjkjkjjjkkjjjuiuuiuiuuikjkjjij}
\tilde\omega\left\{\frac{\partial}{\partial
t}\left\{\frac{1}{c_0}\vec k\right\}+\Div_{\vec x}
\left\{\frac{1}{c_0}\vec k\otimes\vec {\tilde u}\right\}-d_{\vec
x}\vec {\tilde u}\cdot\left(\frac{1}{c_0}\vec
k\right)-\left(\Div_{\vec x}\left\{\frac{1}{c_0}\vec
k\right\}\right)(c_0\vec k)+2G\left(\frac{1}{c_0}\vec
k\right)\right\}\\= \nabla_{\vec x}\tilde\omega-\left(\vec
k\cdot\nabla_{\vec x}\tilde\omega\right)\vec k-
\frac{2}{c_0}\left\{\frac{\partial\tilde\omega}{\partial t}+\vec
h\cdot\nabla_{\vec x} \tilde\omega\right\}\vec
k\\-\tilde\omega\left(d_{\vec x}\vec {\tilde u}+\left\{d_{\vec
x}\vec {\tilde u}\right\}^T\right)\cdot\left(\frac{1}{c_0}\vec
k\right)+ 2\tilde\omega\left(\frac{1}{c_0}\left(\frac{\partial
c_0}{\partial t}+\vec {\tilde u}\cdot\nabla_{\vec x}
c_0\right)\right)\left(\frac{1}{c_0}\vec k\right)\,.
\end{multline}
However, by
\er{MaxVacFullPPNmmmffffffiuiuhjuughbghhiuijghghghhhgjgfghjdfjgddg1133khhujuuuo33uuuiujkjkuyuyiioiiuuiiiiu35}
we have:
\begin{equation}\label{MaxVacFullPPNmmmffffffiuiuhjuughbghhiuijghghghhhgjgfghjdfjgddg1133khhujuuuo33uuuiujkjkuyuyiioiiuuiiiiu35hhjjhjhjh}
\frac{1}{\tilde\omega}\left(\frac{\partial \tilde\omega}{\partial
t}+\vec h\cdot\nabla_{\vec
x}\tilde\omega\right)=\frac{1}{c_0}\left(\frac{\partial
c_0}{\partial t}+\vec {\tilde u}\cdot\nabla_{\vec
x}c_0\right)-\frac{1}{2}\left(\vec k\cdot\left(\left\{d_{\vec x}\vec
{\tilde u}\right\}^T+d_{\vec x}\vec {\tilde
u}\right)\right)\cdot\vec k\,.
\end{equation}
Thus, inserting
\er{MaxVacFullPPNmmmffffffiuiuhjuughbghhiuijghghghhhgjgfghjdfjgddg1133khhujuuuo33uuuiujkjkuyuyiioiiuuiiiiu35hhjjhjhjh}
into
\er{MaxVacFullPPNmmmffffffiuiuhjuughbghhiuijghghghhhfhhghghguygtjjjkkjjkkhhklklkjk33hhjjklikjjkjkjjjkoi;;lll1jkkjkjjljjjkjjkjkjjjkkjjjuiuuiuiuuikjkjjij}
gives
\begin{multline}\label{MaxVacFullPPNmmmffffffiuiuhjuughbghhiuijghghghhhfhhghghguygtjjjkkjjkkhhklklkjk33hhjjklikjjkjkjjjkoi;;lll1jkkjkjjljjjkjjkjkjjjkkjjjuiuuiuiuuikjkjjijioioio}
\tilde\omega\left\{\frac{\partial}{\partial
t}\left\{\frac{1}{c_0}\vec k\right\}+\Div_{\vec x}
\left\{\frac{1}{c_0}\vec k\otimes\vec {\tilde u}\right\}-d_{\vec
x}\vec {\tilde u}\cdot\left(\frac{1}{c_0}\vec
k\right)-\left(\Div_{\vec x}\left\{\frac{1}{c_0}\vec
k\right\}\right)(c_0\vec k)+2G\left(\frac{1}{c_0}\vec
k\right)\right\}\\= \nabla_{\vec x}\tilde\omega-\left(\vec
k\cdot\nabla_{\vec x}\tilde\omega\right)\vec k
-\frac{\tilde\omega}{c_0}\left(d_{\vec x}\vec {\tilde
u}+\left\{d_{\vec x}\vec {\tilde u}\right\}^T\right)\cdot\vec
k+\frac{\tilde\omega}{c_0}\left(\left(\vec
k\cdot\left(\left\{d_{\vec x}\vec {\tilde u}\right\}^T+d_{\vec
x}\vec {\tilde u}\right)\right)\cdot\vec k\right)\vec k \,.
\end{multline}
Thus, by
\er{MaxVacFullPPNmmmffffffiuiuhjuughbghhiuijghghghhhfhhghghguygtjjjkkjjkkhhklklkjk33hhjjklikjjkjkjjjkoi;;lll1jkkjkjjljjjkjjkjkjjjkkjjjuiuuiuiuuikjkjjijioioio}
we rewrite
\er{MaxVacFullPPNmmmffffffiuiuhjuughbghhiuijghghghhhfhhghghguygtjjjkkjjkkhhklklkjk33hhjjklikjjkjkjjjkoi;;lll1jkkjkjjljjjkjjkjkjjjkk}
as:
\begin{multline}\label{MaxVacFullPPNmmmffffffiuiuhjuughbghhiuijghghghhhfhhghghguygtjjjkkjjkkhhklklkjk33hhjjklikjjkjkjjjkoi;;lll1jkkjkjjljjjkjjkjkjjjkkiii77}
\frac{c^2_0}{2\tilde\omega^2}\nabla_{\vec x}A\cdot\left(\nabla_{\vec
x}\tilde\omega-\left(\vec k\cdot\nabla_{\vec
x}\tilde\omega\right)\vec k -\frac{\tilde\omega}{c_0}\left(d_{\vec
x}\vec {\tilde u}+\left\{d_{\vec x}\vec {\tilde
u}\right\}^T\right)\cdot\vec
k+\frac{\tilde\omega}{c_0}\left(\left(\vec
k\cdot\left(\left\{d_{\vec x}\vec {\tilde u}\right\}^T+d_{\vec
x}\vec {\tilde u}\right)\right)\cdot\vec k\right)\vec k\right)\\
+\frac{c^2_0}{2\tilde\omega}A\left(\frac{\partial}{\partial
t}\left\{\frac{1}{c^2_0}G\right\}+\Div_{\vec x}
\left\{\frac{1}{c^2_0}G\vec {\tilde
u}\right\}+\frac{1}{c^2_0}(c_0\vec k)\cdot\nabla_{\vec
x}G+\frac{1}{c^2_0}G^2\right)
-\frac{c^2_0}{2\tilde\omega}\left(\Div_{\vec x}\left\{\nabla_{\vec
x}A-\left(\vec
k\cdot\nabla_{\vec x}A\right)\vec k\right\}\right)\approx\\
-i\left\{\frac{\partial A}{\partial t}
+\vec h\cdot\nabla_{\vec x}A
+\frac{1}{2}\,\left(\Div_{\vec x}\vec h \right)A\right\}-
iA\left(\frac{c_0}{\sqrt{\tilde\omega}}\left(\frac{\partial}{\partial
t}\left\{\frac{\sqrt{\tilde\omega}}{c_0}\right\}+\vec
h\cdot\nabla_{\vec x}
\left\{\frac{\sqrt{\tilde\omega}}{c_0}\,\right\}\right)\right),
\end{multline}
We rewrite it as
\begin{multline}\label{MaxVacFullPPNmmmffffffiuiuhjuughbghhiuijghghghhhfhhghghguygtjjjkkjjkkhhklklkjk33hhjjklikjjkjkjjjkoi;;lll1jkkjkjjljjjkjjkjkjjjkkiiiiouio}
-\frac{c_0}{2\tilde\omega}\left(\nabla_{\vec x}A-\left(\vec
k\cdot\nabla_{\vec x}A\right)\vec k\right)\cdot\left(\left(d_{\vec
x}\vec {\tilde u}+\left\{d_{\vec x}\vec {\tilde
u}\right\}^T\right)\cdot\vec k-\left(\left(\vec
k\cdot\left(\left\{d_{\vec x}\vec {\tilde u}\right\}^T+d_{\vec
x}\vec {\tilde u}\right)\right)\cdot\vec k\right)\vec k\right)\\
+\frac{c^2_0}{2\tilde\omega}A\left(\frac{\partial}{\partial
t}\left\{\frac{1}{c^2_0}G\right\}+\Div_{\vec x}
\left\{\frac{1}{c^2_0}G\vec {\tilde
u}\right\}+\frac{1}{c^2_0}(c_0\vec k)\cdot\nabla_{\vec
x}G+\frac{1}{c^2_0}G^2\right) -\frac{c^2_0}{2}\left(\Div_{\vec
x}\left\{\frac{1}{\tilde\omega}\left(\nabla_{\vec x}A-\left(\vec
k\cdot\nabla_{\vec x}A\right)\vec k\right)\right\}\right)\\ \approx
-i\left\{\frac{\partial A}{\partial t}
+\vec h\cdot\nabla_{\vec x}A
+\frac{1}{2}\,\left(\Div_{\vec x}\vec h \right)A\right\}-
iA\left(\frac{c_0}{\sqrt{\tilde\omega}}\left(\frac{\partial}{\partial
t}\left\{\frac{\sqrt{\tilde\omega}}{c_0}\right\}+\vec
h\cdot\nabla_{\vec x}
\left\{\frac{\sqrt{\tilde\omega}}{c_0}\,\right\}\right)\right).
\end{multline}
By
\er{MaxVacFullPPNmmmffffffiuiuhjuughbghhiuijghghghhhfhhghghguygtjjjkkjjkkhhklklkjk33hhjjklikjjkjkjjjkoi;;lll1jkkjkjjljjjkjjkjkjjjkkiiiiouio}
we have
\begin{multline}\label{MaxVacFullPPNmmmffffffiuiuhjuughbghhiuijghghghhhfhhghghguygtjjjkkjjkkhhklklkjk33hhjjklikjjkjkjjjkoi;;lll1jkkjkjjljjjkjjkjkjjjkkiii}
-\frac{c_0}{2\tilde\omega}\left(\nabla_{\vec x}A-\left(\vec
k\cdot\nabla_{\vec x}A\right)\vec k\right)\cdot\left(\left(d_{\vec
x}\vec {\tilde u}+\left\{d_{\vec x}\vec {\tilde
u}\right\}^T\right)\cdot\vec k-\left(\left(\vec
k\cdot\left(\left\{d_{\vec x}\vec {\tilde u}\right\}^T+d_{\vec
x}\vec {\tilde u}\right)\right)\cdot\vec k\right)\vec k\right)\\
+\frac{c^2_0}{2\tilde\omega}A\left(\frac{\partial}{\partial
t}\left\{\frac{1}{c^2_0}G\right\}+\Div_{\vec x}
\left\{\frac{1}{c^2_0}G\vec {\tilde
u}\right\}+\frac{1}{c^2_0}(c_0\vec k)\cdot\nabla_{\vec
x}G+\frac{1}{c^2_0}G^2\right) -\frac{c^2_0}{2}\left(\Div_{\vec
x}\left\{\frac{1}{\tilde\omega}\left(\nabla_{\vec x}A-\left(\vec
k\cdot\nabla_{\vec x}A\right)\vec k\right)\right\}\right)\\ \approx
-i\left\{\frac{\partial A}{\partial t}
+\vec h\cdot\nabla_{\vec x}A
+\frac{1}{2}\,\left(\Div_{\vec x}\vec h \right)A\right\}\\-
iA\frac{1}{2\tilde\omega}\left(\frac{\partial \tilde\omega}{\partial
t}+\vec h\cdot\nabla_{\vec
x}\tilde\omega\right)+iA\frac{1}{c_0}\left(\frac{\partial
c_0}{\partial t}+\vec h\cdot\nabla_{\vec x}c_0\right).
\end{multline}
On the other hand, by
\er{MaxVacFullPPNmmmffffffiuiuhjuughbghhiuijghghghhhgjgfghjdfjgddg1133khhujuuuo33uuuiujkjkuyuyiioiiuuiiiiu35}
we have
\begin{equation}\label{MaxVacFullPPNmmmffffffiuiuhjuughbghhiuijghghghhhgjgfghjdfjgddg1133khhujuuuo33uuuiujkjkuyuyiioiiuuiiiiu35jhjgh}
\frac{1}{2\tilde\omega}\left(\frac{\partial \tilde\omega}{\partial
t}+\vec h\cdot\nabla_{\vec
x}\tilde\omega\right)=\frac{1}{2c_0}\left(\frac{\partial
c_0}{\partial t}+\vec {\tilde u}\cdot\nabla_{\vec
x}c_0\right)-\frac{1}{4}\left(\vec k\cdot\left(\left\{d_{\vec x}\vec
{\tilde u}\right\}^T+d_{\vec x}\vec {\tilde
u}\right)\right)\cdot\vec k\,.
\end{equation}
Therefore, inserting
\er{MaxVacFullPPNmmmffffffiuiuhjuughbghhiuijghghghhhgjgfghjdfjgddg1133khhujuuuo33uuuiujkjkuyuyiioiiuuiiiiu35jhjgh}
into
\er{MaxVacFullPPNmmmffffffiuiuhjuughbghhiuijghghghhhfhhghghguygtjjjkkjjkkhhklklkjk33hhjjklikjjkjkjjjkoi;;lll1jkkjkjjljjjkjjkjkjjjkkiii}
gives
\begin{multline}\label{MaxVacFullPPNmmmffffffiuiuhjuughbghhiuijghghghhhfhhghghguygtjjjkkjjkkhhklklkjk33hhjjklikjjkjkjjjkoi;;lll1jkkjkjjljjjkjjkjkjjjkkiiihh}
-\frac{c_0}{2\tilde\omega}\left(\nabla_{\vec x}A-\left(\vec
k\cdot\nabla_{\vec x}A\right)\vec k\right)\cdot\left(\left(d_{\vec
x}\vec {\tilde u}+\left\{d_{\vec x}\vec {\tilde
u}\right\}^T\right)\cdot\vec k-\left(\left(\vec
k\cdot\left(\left\{d_{\vec x}\vec {\tilde u}\right\}^T+d_{\vec
x}\vec {\tilde u}\right)\right)\cdot\vec k\right)\vec k\right)\\
+\frac{c^2_0}{2\tilde\omega}A\left(\frac{\partial}{\partial
t}\left\{\frac{1}{c^2_0}G\right\}+\Div_{\vec x}
\left\{\frac{1}{c^2_0}G\vec {\tilde
u}\right\}+\frac{1}{c^2_0}(c_0\vec k)\cdot\nabla_{\vec
x}G+\frac{1}{c^2_0}G^2\right) \\-\frac{c^2_0}{2}\left(\Div_{\vec
x}\left\{\frac{1}{\tilde\omega}\left(\nabla_{\vec x}A-\left(\vec
k\cdot\nabla_{\vec x}A\right)\vec k\right)\right\}\right) \approx
-i\left\{\frac{\partial A}{\partial t}
+\vec h\cdot\nabla_{\vec x}A
+\frac{1}{2}\,\left(\Div_{\vec x}\vec h \right)A\right\}\\-
iA\frac{1}{2c_0}\left(c_0\vec k\cdot\nabla_{\vec
x}c_0\right)+iA\frac{1}{2c_0}\left(\frac{\partial c_0}{\partial
t}+\vec h\cdot\nabla_{\vec x}c_0\right)+ \frac{iA}{4}\left(\vec
k\cdot\left(\left\{d_{\vec x}\vec {\tilde u}\right\}^T+d_{\vec
x}\vec {\tilde u}\right)\right)\cdot\vec k.
\end{multline}
Moreover, as before, we can easily deduce that
\er{MaxVacFullPPNmmmffffffiuiuhjuughbghhiuijghghghhhfhhghghguygtjjjkkjjkkhhklklkjk33hhjjklikjjkjkjjjkoi;;lll1jkkjkjjljjjkjjkjkjjjkkiiihh}
invariant under the change of non-inertial cartesian coordinate
system, in the case that under such change we have $A'=A$ and
$S'=S$.

In particular, if in some Cartesian coordinate system we assume time
independent settings of the problem so that $\frac{\partial
S}{\partial t}=c$, $\frac{\partial c_0}{\partial t}=0$ and
$\frac{\partial \vec {\tilde u}}{\partial t}=0$, then we also have
$\frac{\partial {\vec k}}{\partial t}=\frac{\partial  {\vec
h}}{\partial t}=0$ and $\frac{\partial \tilde\omega}{\partial
t}=\frac{\partial G}{\partial t}=0$, and therefore, time-independent
solutions $A:=A(\vec x)$, of
\er{MaxVacFullPPNmmmffffffiuiuhjuughbghhiuijghghghhhfhhghghguygtjjjkkjjkkhhklklkjk33hhjjklikjjkjkjjjkoi;;lll1jkkjkjjljjjkjjkjkjjjkkiiihh}
are admitted and they satisfy
\begin{multline}\label{MaxVacFullPPNmmmffffffiuiuhjuughbghhiuijghghghhhfhhghghguygtjjjkkjjkkhhklklkjk33hhjjklikjjkjkjjjkoi;;lll1jkkjkjjljjjkjjkjkjjjkkiiihhukuku}
-\frac{c_0}{2\tilde\omega}\left(\nabla_{\vec x}A-\left(\vec
k\cdot\nabla_{\vec x}A\right)\vec k\right)\cdot\left(\left(d_{\vec
x}\vec {\tilde u}+\left\{d_{\vec x}\vec {\tilde
u}\right\}^T\right)\cdot\vec k-\left(\left(\vec
k\cdot\left(\left\{d_{\vec x}\vec {\tilde u}\right\}^T+d_{\vec
x}\vec {\tilde u}\right)\right)\cdot\vec k\right)\vec k\right)\\
+\frac{c^2_0}{2\tilde\omega}A\left(\Div_{\vec x}
\left\{\frac{1}{c^2_0}G\vec {\tilde
u}\right\}+\frac{1}{c^2_0}(c_0\vec k)\cdot\nabla_{\vec
x}G+\frac{1}{c^2_0}G^2\right) -\frac{c^2_0}{2}\left(\Div_{\vec
x}\left\{\frac{1}{\tilde\omega}\left(\nabla_{\vec x}A-\left(\vec
k\cdot\nabla_{\vec x}A\right)\vec k\right)\right\}\right)\\ \approx
-i\left\{\vec h\cdot\nabla_{\vec x}A
+\frac{1}{2}\,\left(\Div_{\vec x}\vec h \right)A\right\}\\-
iA\frac{1}{2c_0}\left(c_0\vec k\cdot\nabla_{\vec
x}c_0\right)+iA\frac{1}{2c_0}\left(\vec h\cdot\nabla_{\vec
x}c_0\right)+ \frac{iA}{4}\left(\vec k\cdot\left(\left\{d_{\vec
x}\vec {\tilde u}\right\}^T+d_{\vec x}\vec {\tilde
u}\right)\right)\cdot\vec k.
\end{multline}

Finally, from now we assume that $\vec {\tilde u}$ and $c_0$ vary
sufficiently slowly in space and time variables, so that the
following approximations are valid:
\begin{equation}\label{MaxVacFullPPNmmmffffffhhtygghGGGGyuhggghghghffgiooiiojhjjhhjuiuiugughhjggjjhk}
\frac{|\nabla_{\vec x}c_0|+\frac{1}{c_0}\left|\frac{\partial
c_0}{\partial t}+\vec {\tilde u}\cdot\nabla_{\vec x}c_0\right|}{
c_0}+\frac{\left| d_{\vec x}\vec {\tilde u}+\left\{d_{\vec x}\vec
{\tilde u}\right\}^T\right|}{
c_0}\ll\frac{\left(| \nabla_{\vec
x}A|+\frac{1}{c_0}\left|\frac{\partial A}{\partial t}+\vec {\tilde
u}\cdot\nabla_{\vec x}A\right|\right)}{|A|},
\end{equation}
In particular, if in some Cartesian coordinate system we have
\begin{equation}\label{MaxVacFullPPNmmmffffffiuiuhjuughbghhiuijghghghhjhjhhghyuyhhjhjihikjjjjkjljklghhgggh}
\begin{cases}
\frac{\left| d_{\vec x}\vec {\tilde u}+\left\{d_{\vec x}\vec {\tilde
u}\right\}^T\right|^2}{
|\vec {\tilde u}|^2}\ll\frac{\left(| \nabla_{\vec
x}A|+\frac{1}{c_0}\left|\frac{\partial A}{\partial t}+\vec {\tilde
u}\cdot\nabla_{\vec x}A\right|\right)^2}{|A|^2}\\
\frac{\left|\vec {\tilde u}\right|^2}{
c^2_0}\ll 1\\
\frac{|\nabla_{\vec x}c_0|+\frac{1}{c_0}\left|\frac{\partial
c_0}{\partial t}+\vec {\tilde u}\cdot\nabla_{\vec x}c_0\right|}{
c_0}\ll\frac{\left(| \nabla_{\vec
x}A|+\frac{1}{c_0}\left|\frac{\partial A}{\partial t}+\vec {\tilde
u}\cdot\nabla_{\vec x}A\right|\right)}{|A|},
\end{cases}
\end{equation}
then
\er{MaxVacFullPPNmmmffffffhhtygghGGGGyuhggghghghffgiooiiojhjjhhjuiuiugughhjggjjhk}
indeed holds! Then, by
\er{MaxVacFullPPNmmmffffffhhtygghGGGGyuhggghghghffgiooiiojhjjhhjuiuiugughhjggjjhk}
we rewrite and simplify
\er{MaxVacFullPPNmmmffffffiuiuhjuughbghhiuijghghghhhfhhghghguygtjjjkkjjkkhhklklkjk33hhjjklikjjkjkjjjkoi;;lll1jkkjkjjljjjkjjkjkjjjkkiiihh}
as
\begin{multline}\label{MaxVacFullPPNmmmffffffiuiuhjuughbghhiuijghghghhhfhhghghguygtjjjkkjjkkhhklklkjk33hhjjklikjjkjkjjjkoi;;lll1jkkjkjjljjjkjjkjkjjjkkiiihhihhjh}
\frac{1}{2\tilde\omega}A\left(\frac{\partial G}{\partial
t}+\Div_{\vec x} \left\{G\vec {\tilde u}\right\}+(c_0\vec
k)\cdot\nabla_{\vec x}G+G^2\right)
\\-\frac{1}{2}\left(\Div_{\vec
x}\left\{\frac{c^2_0}{\tilde\omega}\left(\nabla_{\vec x}A-\left(\vec
k\cdot\nabla_{\vec x}A\right)\vec k\right)\right\}\right) \approx
-i\left\{\frac{\partial A}{\partial t}
+\vec h\cdot\nabla_{\vec x}A
+\frac{1}{2}\,\left(\Div_{\vec x}\vec h \right)A\right\}.
\end{multline}
Moreover, as before, we can easily deduce that
\er{MaxVacFullPPNmmmffffffiuiuhjuughbghhiuijghghghhhfhhghghguygtjjjkkjjkkhhklklkjk33hhjjklikjjkjkjjjkoi;;lll1jkkjkjjljjjkjjkjkjjjkkiiihhihhjh}
invariant under the change of non-inertial cartesian coordinate
system, in the case that under such change we have $A'=A$ and
$S'=S$.

Furthermore, in time independent settings we rewrite
\er{MaxVacFullPPNmmmffffffiuiuhjuughbghhiuijghghghhhfhhghghguygtjjjkkjjkkhhklklkjk33hhjjklikjjkjkjjjkoi;;lll1jkkjkjjljjjkjjkjkjjjkkiiihhukuku}
as
\begin{multline}\label{MaxVacFullPPNmmmffffffiuiuhjuughbghhiuijghghghhhfhhghghguygtjjjkkjjkkhhklklkjk33hhjjklikjjkjkjjjkoi;;lll1jkkjkjjljjjkjjkjkjjjkkiiihhihhjhijjk}
\frac{1}{2\tilde\omega}A\left(\Div_{\vec x} \left\{G\vec {\tilde
u}\right\}+(c_0\vec k)\cdot\nabla_{\vec x}G+G^2\right)
\\-\frac{1}{2}\left(\Div_{\vec
x}\left\{\frac{c^2_0}{\tilde\omega}\left(\nabla_{\vec x}A-\left(\vec
k\cdot\nabla_{\vec x}A\right)\vec k\right)\right\}\right) \approx
-i\left\{\vec h\cdot\nabla_{\vec x}A
+\frac{1}{2}\,\left(\Div_{\vec x}\vec h \right)A\right\}.
\end{multline}
In particular, in the case $\vec {\tilde u}=0$ and constant $\vec
k$, $c_0$ and $\tilde\omega$,
\er{MaxVacFullPPNmmmffffffiuiuhjuughbghhiuijghghghhhfhhghghguygtjjjkkjjkkhhklklkjk33hhjjklikjjkjkjjjkoi;;lll1jkkjkjjljjjkjjkjkjjjkkiiihhihhjhijjk}
simplifies as the paraxial approximation of the Helmholtz equation,
with respect to direction $(-\vec k)$, which is common in Optics:
\begin{equation}\label{MaxVacFullPPNmmmffffffiuiuhjuughbghhiuijghghghhhfhhghghguygtjjjkkjjkkhhklklkjk33hhjjklikjjkjkjjjkoi;;lll1jkkjkjjljjjkjjkjkjjjkkiiihhihhjhijjkjkjkjkklk}
\frac{c_0}{2\tilde\omega}\left(\Div_{\vec
x}\left\{\left(\nabla_{\vec x}A-\left(\vec h\cdot\nabla_{\vec
x}A\right)\vec h\right)\right\}\right) \approx
-i\vec k\cdot\nabla_{\vec x}A .
\end{equation}

Next, note that, the time dependent equation
\er{MaxVacFullPPNmmmffffffiuiuhjuughbghhiuijghghghhhfhhghghguygtjjjkkjjkkhhklklkjk33hhjjklikjjkjkjjjkoi;;lll1jkkjkjjljjjkjjkjkjjjkkiiihhihhjh}
is very similar to the Schr\"{o}dinger equation and similarly can be
written in the form
\begin{equation}\label{MaxVacFullPPNmmmffffffiuiuhjuughbghhiuijghghghhhfhhghghguygtjjjkkjjkkhhklklkjk33hhjjklikjjkjkjjjkoi;;lll1jkkjkjjljjjkjjkjkjjjkkiiihhihhjhuouo}
 i\hbar\frac{\partial A}{\partial t} =\hat H\cdot A.
\end{equation}
where
\begin{multline}\label{MaxVacFullPPNmmmffffffiuiuhjuughbghhiuijghghghhhfhjkkhkhghghguygtjjjkkjjkkhhklklkjk33hhjjklikjjkjkjjjkoi;;lll1jkkjkjjljjjkjjkjkjjjkkiiihhihhjhkklhkhk}
\hat H\cdot A:= \frac{\hbar}{2}\left(\Div_{\vec
x}\left\{\frac{c^2_0}{\tilde\omega}\left(\nabla_{\vec x}A-\left(\vec
k\cdot\nabla_{\vec x}A\right)\vec k\right)\right\}\right)
-i\hbar\left\{
\vec h\cdot\nabla_{\vec x}A
+\frac{1}{2}\,\left(\Div_{\vec x}\vec h \right)A\right\}\\
-\frac{\hbar}{2\tilde\omega}\left(\frac{\partial G}{\partial
t}+\Div_{\vec x} \left\{G\vec {\tilde u}\right\}+(c_0\vec
k)\cdot\nabla_{\vec x}G+G^2\right)A
\end{multline}
is a self-adjoint linear operator on the complex Hilbert space.

Finally, solving either
\er{MaxVacFullPPNmmmffffffiuiuhjuughbghhiuijghghghhhfhhghghguygtjjjkkjjkkhhklklkjk33hhjjklikjjkjkjjjkoi;;lll1jkkjkjjljjjkjjkjkjjjkkiiihh}
or
\er{MaxVacFullPPNmmmffffffiuiuhjuughbghhiuijghghghhhfhhghghguygtjjjkkjjkkhhklklkjk33hhjjklikjjkjkjjjkoi;;lll1jkkjkjjljjjkjjkjkjjjkkiiihhihhjh}
and inserting the solution into
\er{MaxVacFullPPNmmmffffffiuiuhjuughbghhuiiujjjjjjjjgjjgiuiuui} we
established $U(\vec x,t)$ in
\er{MaxVacFullPPNmmmffffffiuiuhjuughbghhjjjoojouiuiu} in the
\underline{delicate} Geometric Optics approximation
\er{slochangGGaaffgfgjhjggh}.

\subsubsection{Wave optics inside a moving medium and/or in the
presence of gravitational field}\label{jghggghghgh} Assume that in
some inertial or non-inertial cartesian coordinate system a
\underline{complex valued} scalar field $U:=U(\vec x,t)$,
characterizing some wave, satisfies the wave equation
\er{MaxVacFullPPNmmmffffffiuiuhjuughbghhjj}:
\begin{equation}\label{MaxVacFullPPNmmmffffffiuiuhjuughbghhjjhh11}
\frac{\partial}{\partial
t}\left\{\frac{1}{c^2_0}\left(\frac{\partial U}{\partial t}+\vec
{\tilde u}\cdot\nabla_{\vec x}U\right)\right\}+\Div_{\vec x}
\left\{\frac{1}{c^2_0}\left(\frac{\partial U}{\partial t}+\vec
{\tilde u}\cdot\nabla_{\vec x} U\right)\vec {\tilde
u}\right\}-\Delta_{\vec x}U=0,
\end{equation}
where, as before, $\vec {\tilde u}:=\vec {\tilde u}(\vec x,t)$ is
some moderately changing (in space and in time) speed-like vector
field and $c_0:=c_0(\vec x,t)>0$ is a moderately changing scalar
quantity, that we call wave propagation speed. In particular, if we
assume that the fields $\vec {\tilde u}$ and $c_0$ are independent
on the time variable, and we consider the case of a monochromatic
wave of the constant frequency $\nu=\frac{\omega}{2\pi}$:
\begin{equation}\label{MaxVacFullPPNmmmffffffiuiuhjuughbghhiuijghghghhjhjhhghyuyhhjhjihikjjjjkjljklhh11}
U(\vec x,t):=\mathcal{U}(\vec x)e^{i\omega t}\quad\quad\forall(\vec
x,t),
\end{equation}
where $\mathcal{U}:=\mathcal{U}(\vec x)$ is a complex-valued scalar
field, independent of time, then inserting
\er{MaxVacFullPPNmmmffffffiuiuhjuughbghhiuijghghghhjhjhhghyuyhhjhjihikjjjjkjljklhh11}
into \er{MaxVacFullPPNmmmffffffiuiuhjuughbghhjjhh11} gives the
following time independent equation:
\begin{equation}\label{MaxVacFullPPNmmmffffffiuiuhjuughbghhjjuggk;klklkljkl}
\frac{i\omega}{c^2_0(\vec x)}\left(i\omega \mathcal{U}(\vec x)+\vec
{\tilde u}(\vec x)\cdot\nabla_{\vec x}\mathcal{U}(\vec
x)\right)+\Div_{\vec x} \left\{\frac{1}{c^2_0(\vec x)}\left(i\omega
\mathcal{U}(\vec x)+\vec {\tilde u}(\vec x)\cdot\nabla_{\vec x}
\mathcal{U}(\vec x)\right)\vec {\tilde u}(\vec
x)\right\}-\Delta_{\vec x}\mathcal{U}(\vec x)=0.
\end{equation}

Next, suppose that, although we cannot consider the Geometric Optics
approximation \er{slochangGGaaffgfgjhjggh}, we can consider,
however, the following assumption:
\begin{equation}\label{iggguggkhhhh}
\vec {\tilde u}(\vec x,t)=\nabla_{\vec x}\tilde Z(\vec x,t)
\quad\quad\forall(\vec x,t),
\end{equation}
where $\tilde Z:=\tilde Z(\vec x,t)$ is some real-valued scalar
function, which satisfies
\begin{equation}\label{iggguggkhhhh22}
\frac{1}{c^2_0}\left|\frac{\partial \tilde Z}{\partial
t}\right|+\frac{1}{c^2_0}\left|\nabla_{\vec x}\tilde Z\right|^2\ll
1.
\end{equation}
Note that \er{iggguggkhhhh} is valid in the particular case of the
electromagnetic waves, propagating in vacuum, in the
\underline{inertial} or-more generally non-rotating cartesian
coordinate system (where we indeed have $\vec {\tilde u}=\vec v$ and
$curl_{\vec x}\vec v=0$). On the other hand, \er{iggguggkhhhh22}
means that $\frac{\left|\vec {\tilde u}\right|^2}{c^2_0}\ll 1$ and
$\vec {\tilde u}$ is either independent of time or depends on time
slowly. Moreover, assume that $c_0$ is a constant in the given
region.

Next we do the change of variables $(\vec x,t)\to(\vec z,\tau)$ as
\begin{equation}\label{hvkgkjgkjkhhhhjjkhkhkkjjjjhgj}
\begin{cases}
\vec z:=\vec x,\\
\tau:=t+\frac{1}{c^2_0
}\tilde Z(\vec x,t),
\end{cases}
\end{equation}
so that, we can find a complex-valued function $V:=V(\vec z,\tau)$
depending on  $\vec z\in\mathbb{R}^3$ and a real $\tau$, which
satisfies:
\begin{equation}\label{hvkgkjgkjkhhhhjjkhkhk}
U(\vec x,t):=V\left(\vec x,t+\frac{1}{c^2_0
}\tilde Z(\vec
x,t)\right).
\end{equation}
Next we differentiate \er{hvkgkjgkjkhhhhjjkhkhk}, assuming that
$c_0$ vary very slowly and we can consider it to be a constant in
the differentiation. Thus, by the Chain Rule we deduce:
\begin{equation}\label{hvkgkjgkjkhhhhjjkhkhkhjgh}
\begin{cases}
\nabla_{\vec x}U(\vec x,t)=\nabla_{\vec z}V\left(\vec
x,t+\frac{1}{c^2_0}\tilde Z(\vec
x,t)\right)+\frac{1}{c^2_0}\frac{\partial V}{\partial\tau}\left(\vec
x,t+\frac{1}{c^2_0}\tilde Z(\vec
x,t)\right)\,\nabla_{\vec x}\tilde Z(\vec x,t)\\
\frac{\partial U}{\partial t}(\vec x,t)=\frac{\partial
V}{\partial\tau}\left(\vec x,t+\tilde Z(\vec
x,t)\right)\left(1+\frac{1}{c^2_0}\frac{\partial \tilde Z}{\partial
t}(\vec x,t)\right)\,.
\end{cases}
\end{equation}
Therefore, again  by the Chain Rule we infer:
\begin{multline}\label{hvkgkjgkjkhhhhjjkhkhkjkhkhkgjggjjg}
\nabla^2_{\vec x}U(\vec x,t)=\nabla^2_{\vec z}V\left(\vec
x,t+\frac{1}{c^2_0}\tilde Z(\vec
x,t)\right)+\frac{1}{c^2_0}\nabla_{\vec z}\left(\frac{\partial
V}{\partial\tau}\right)\left(\vec x,t+\frac{1}{c^2_0}\tilde Z(\vec
x,t)\right)\otimes\nabla_{\vec x}\tilde Z(\vec
x,t)\\+\frac{1}{c^2_0}\nabla_{\vec x}\tilde Z(\vec
x,t)\otimes\nabla_{\vec z}\left(\frac{\partial
V}{\partial\tau}\right)\left(\vec x,t+\frac{1}{c^2_0}\tilde Z(\vec
x,t)\right)+\frac{1}{c^4_0}\frac{\partial^2
V}{\partial\tau^2}\left(\vec x,t+\frac{1}{c^2_0}\tilde Z(\vec
x,t)\right)\,\nabla_{\vec x}\tilde Z(\vec x,t)\otimes\nabla_{\vec
x}\tilde Z(\vec x,t)\\+\frac{1}{c^2_0}\frac{\partial
V}{\partial\tau}\left(\vec x,t+\frac{1}{c^2_0}\tilde Z(\vec
x,t)\right)\,\nabla^2_{\vec x}\tilde Z(\vec x,t).
\end{multline}
In particular,
\begin{multline}\label{hvkgkjgkjkhhhhjjkhkhkjkhkhkgjggjjghggjk}
\Delta_{\vec x}U(\vec x,t)=\Delta_{\vec z}V\left(\vec
x,t+\frac{1}{c^2_0}\tilde Z(\vec
x,t)\right)+\frac{2}{c^2_0}\nabla_{\vec z}\left(\frac{\partial
V}{\partial\tau}\right)\left(\vec x,t+\frac{1}{c^2_0}\tilde Z(\vec
x,t)\right)\cdot\nabla_{\vec x}\tilde Z(\vec
x,t)\\+\frac{1}{c^4_0}\frac{\partial^2 V}{\partial\tau^2}\left(\vec
x,t+\frac{1}{c^2_0}\tilde Z(\vec x,t)\right)\left|\nabla_{\vec
x}\tilde Z(\vec x,t)\right|^2+\frac{1}{c^2_0}\frac{\partial
V}{\partial\tau}\left(\vec x,t+\frac{1}{c^2_0}\tilde Z(\vec
x,t)\right)\Delta_{\vec x}\tilde Z(\vec x,t).
\end{multline}
Moreover, again by the Chain Rule we have
\begin{multline}\label{hvkgkjgkjkhhhhjjkhkhkjkhkhkgjggjjgjujkkhhk}
\nabla_{\vec x}\left(\frac{\partial U}{\partial t}\right)(\vec
x,t)=\left(1+\frac{1}{c^2_0}\frac{\partial \tilde Z}{\partial
t}(\vec x,t)\right)\nabla_{\vec z}\left(\frac{\partial
V}{\partial\tau}\right)\left(\vec x,t+\frac{1}{c^2_0}\tilde Z(\vec
x,t)\right)\\+\frac{1}{c^2_0}\left(1+\frac{1}{c^2_0}\frac{\partial
\tilde Z}{\partial t}(\vec x,t)\right)\frac{\partial^2
V}{\partial\tau^2}\left(\vec x,t+\frac{1}{c^2_0}\tilde Z(\vec
x,t)\right)\nabla_{\vec x}\tilde Z(\vec
x,t)\\+\frac{1}{c^2_0}\frac{\partial V}{\partial\tau}\left(\vec
x,t+\frac{1}{c^2_0}\tilde Z(\vec x,t)\right)\nabla_{\vec
x}\left(\frac{\partial \tilde Z}{\partial t}(\vec x,t)\right),
\end{multline}
and
\begin{equation}\label{hvkgkjgkjkhhhhjjkhkhkjkhkhkgjggjjgjujkkhhkhkhkhhhjg}
\frac{\partial^2 U}{\partial t^2}(\vec x,t)=\\ \frac{\partial^2
V}{\partial\tau^2}\left(\vec x,t+\frac{1}{c^2_0}\tilde Z(\vec
x,t)\right)\left(1+\frac{1}{c^2_0}\frac{\partial \tilde Z}{\partial
t}(\vec x,t)\right)^2+\frac{1}{c^2_0}\frac{\partial
V}{\partial\tau}\left(\vec x,t+\frac{1}{c^2_0}\tilde Z(\vec
x,t)\right)\,\frac{\partial^2 \tilde Z}{\partial t^2}(\vec x,t).
\end{equation}
Then, by \er{hvkgkjgkjkhhhhjjkhkhkjkhkhkgjggjjghggjk},
\er{hvkgkjgkjkhhhhjjkhkhkjkhkhkgjggjjgjujkkhhk} and
\er{hvkgkjgkjkhhhhjjkhkhkjkhkhkgjggjjgjujkkhhkhkhkhhhjg} together
with \er{iggguggkhhhh22} we have
\begin{multline}\label{hvkgkjgkjkhhhhjjkhkhkjkhkhkgjggjjghggjkkkhhk}
\Delta_{\vec x}U(\vec x,t)=\Delta_{\vec z}V\left(\vec
x,t+\frac{1}{c^2_0}\tilde Z(\vec
x,t)\right)+\frac{2}{c^2_0}\nabla_{\vec z}\left(\frac{\partial
V}{\partial\tau}\right)\left(\vec x,t +\frac{1}{c^2_0}\tilde Z(\vec
x,t)\right)\cdot\nabla_{\vec x}\tilde Z(\vec x,t)
\\
+\frac{1}{c^2_0}\frac{\partial V}{\partial\tau}\left(\vec
x,t+\frac{1}{c^2_0}\tilde Z(\vec x,t)\right)\Delta_{\vec x}\tilde
Z(\vec x,t)+\frac{1}{c^4_0}\frac{\partial^2
V}{\partial\tau^2}\left(\vec x,t+\frac{1}{c^2_0}\tilde Z(\vec
x,t)\right)\left|\nabla_{\vec x}\tilde Z(\vec x,t)\right|^2,
\end{multline}
\begin{multline}\label{hvkgkjgkjkhhhhjjkhkhkjkhkhkgjggjjgjujkkhhkhhjh}
\nabla_{\vec x}\left(\frac{\partial U}{\partial t}\right)(\vec
x,t)\approx
\nabla_{\vec z}\left(\frac{\partial
V}{\partial\tau}\right)\left(\vec x,t+\frac{1}{c^2_0}\tilde Z(\vec
x,t)\right)+\frac{1}{c^2_0}
\frac{\partial^2 V}{\partial\tau^2}\left(\vec
x,t+\frac{1}{c^2_0}\tilde Z(\vec x,t)\right)\nabla_{\vec x}\tilde
Z(\vec x,t)\\+\frac{1}{c^2_0}\frac{\partial
V}{\partial\tau}\left(\vec x,t+\frac{1}{c^2_0}\tilde Z(\vec
x,t)\right)\nabla_{\vec x}\left(\frac{\partial \tilde Z}{\partial
t}(\vec x,t)\right),
\end{multline}
and
\begin{equation}\label{hvkgkjgkjkhhhhjjkhkhkjkhkhkgjggjjgjujkkhhkhkhkhhhjgjgj}
\frac{\partial^2 U}{\partial t^2}(\vec x,t)\approx\\
\frac{\partial^2 V}{\partial\tau^2}\left(\vec
x,t+\frac{1}{c^2_0}\tilde Z(\vec
x,t)\right)+\frac{1}{c^2_0}\frac{\partial V}{\partial\tau}\left(\vec
x,t+\frac{1}{c^2_0}\tilde Z(\vec x,t)\right)\,\frac{\partial^2
\tilde Z}{\partial t^2}(\vec x,t).
\end{equation}
Moreover, by \er{hvkgkjgkjkhhhhjjkhkhkhjgh} together with
\er{iggguggkhhhh22} we have
\begin{equation}\label{hvkgkjgkjkhhhhjjkhkhkhjghhkhkkkhkh}
\begin{cases}
\nabla_{\vec x}U(\vec x,t)=\nabla_{\vec z}V\left(\vec
x,t+\frac{1}{c^2_0}\tilde Z(\vec
x,t)\right)+\frac{1}{c^2_0}\frac{\partial V}{\partial\tau}\left(\vec
x,t+\frac{1}{c^2_0}\tilde Z(\vec
x,t)\right)\,\nabla_{\vec x}\tilde Z(\vec x,t)\\
\frac{\partial U}{\partial t}(\vec x,t)\approx\frac{\partial
V}{\partial\tau}\left(\vec x,t+\frac{1}{c^2_0}\tilde Z(\vec
x,t)\right)\,.
\end{cases}
\end{equation}
Thus, by \er{iggguggkhhhh} and \er{iggguggkhhhh22} using
\er{hvkgkjgkjkhhhhjjkhkhkjkhkhkgjggjjghggjkkkhhk},
\er{hvkgkjgkjkhhhhjjkhkhkjkhkhkgjggjjgjujkkhhkhhjh},
\er{hvkgkjgkjkhhhhjjkhkhkjkhkhkgjggjjgjujkkhhkhkhkhhhjgjgj} and
\er{hvkgkjgkjkhhhhjjkhkhkhjghhkhkkkhkh} we deduce
\begin{multline}\label{MaxVacFullPPNmmmffffffiuiuhjuughbghhjjkhhkhh}
\frac{1}{c^2_0}\left(\frac{\partial}{\partial t}\left(\frac{\partial
U}{\partial t}+\vec {\tilde u}\cdot\nabla_{\vec x}U\right)+div_{\vec
x} \left\{\left(\frac{\partial U}{\partial t}+\vec {\tilde
u}\cdot\nabla_{\vec x} U\right)\vec {\tilde
u}\right\}\right)-\Delta_{\vec x}U\approx\\
\frac{1}{c^2_0}\left(\frac{\partial^2 U}{\partial t^2}+2\vec {\tilde
u}\cdot\nabla_{\vec x}\left(\frac{\partial U}{\partial
t}\right)+\left(div_{\vec x}\vec {\tilde u}\right)\frac{\partial
U}{\partial t}\right)-\Delta_{\vec x}U\\ \approx
\frac{1}{c^2_0}\frac{\partial^2 V}{\partial\tau^2}\left(\vec
x,t+\frac{1}{c^2_0}\tilde Z(\vec x,t)\right)-\Delta_{\vec
z}V\left(\vec
x,t+\frac{1}{c^2_0}\tilde Z(\vec x,t)\right)\\
+\frac{1}{c^4_0}\frac{\partial V}{\partial\tau}\left(\vec
x,t+\frac{1}{c^2_0}\tilde Z(\vec x,t)\right)\,\frac{\partial^2
\tilde Z}{\partial t^2}(\vec x,t) +\frac{2}{c^4_0}\frac{\partial
V}{\partial\tau}\left(\vec x,t+\frac{1}{c^2_0}\tilde Z(\vec
x,t)\right)\nabla_{\vec x}\left(\frac{\partial \tilde Z}{\partial
t}(\vec x,t)\right)\cdot\nabla_{\vec x}\tilde Z(\vec
x,t)\\+\frac{1}{c^4_0}\frac{\partial^2 V}{\partial\tau^2}\left(\vec
x,t+\frac{1}{c^2_0}\tilde Z(\vec x,t)\right)\left|\nabla_{\vec
x}\tilde Z(\vec x,t)\right|^2
\approx\\
\frac{1}{c^2_0}\frac{\partial^2 V}{\partial\tau^2}\left(\vec
x,t+\frac{1}{c^2_0}\tilde Z(\vec x,t)\right)-\Delta_{\vec
z}V\left(\vec x,t+\frac{1}{c^2_0}\tilde Z(\vec x,t)\right) =
\frac{1}{c^2_0}\frac{\partial^2 V}{\partial\tau^2}\left(\vec
z,\tau\right)-\Delta_{\vec z}V\left(\vec z,\tau\right).
\end{multline}
Therefore, inserting
\er{MaxVacFullPPNmmmffffffiuiuhjuughbghhjjkhhkhh} into
\er{MaxVacFullPPNmmmffffffiuiuhjuughbghhjjhh11} we deduce that the
function $V(\vec z,\tau)$ solves the standard type of the wave
equation of the form:
\begin{equation}\label{MaxVacFullPPNmmmffffffiuiuhjuughbghhjjkhhkhhhhhj}
\frac{1}{c^2_0}\frac{\partial^2 V}{\partial\tau^2}\left(\vec
z,\tau\right)-\Delta_{\vec z}V\left(\vec z,\tau\right)\approx 0.
\end{equation}
Fortunately, the plenty of tools for analytical resolution of the
\underline{standard} wave equation
\er{MaxVacFullPPNmmmffffffiuiuhjuughbghhjjkhhkhhhhhj} is well known.
In particular, if we assume that the fields $\vec {\tilde u}$ and
$c_0$ are independent of the time variable, and we consider the case
of a monochromatic wave of the constant frequency
$\nu=\frac{\omega}{2\pi}$, as in
\er{MaxVacFullPPNmmmffffffiuiuhjuughbghhiuijghghghhjhjhhghyuyhhjhjihikjjjjkjljklhh11},
then, we can choose $\tilde Z$ to be independent of time and
moreover, inserting
\er{MaxVacFullPPNmmmffffffiuiuhjuughbghhiuijghghghhjhjhhghyuyhhjhjihikjjjjkjljklhh11}
into \er{hvkgkjgkjkhhhhjjkhkhk} gives:
\begin{equation}\label{hvkgkjgkjkhhhhjjkhkhkhkhhjhgjojjo}
\mathcal{U}(\vec x)e^{i\omega t}:=V\left(\vec
x,t+\frac{1}{c^2_0}\tilde Z(\vec x)\right),
\end{equation}
and so
\begin{equation}\label{hvkgkjgkjkhhhhjjkhkhkhkhhjhgjkjiuiiyjljl}
V\left(\vec z,\tau\right)=\mathcal{V}\left(\vec z\right)e^{i\omega
\tau}\quad\quad\text{and}\quad\quad \mathcal{U}(\vec
x):=\mathcal{V}\left(\vec x\right)e^{\frac{i\omega }{c^2_0}\tilde
Z(\vec x)},
\end{equation}
where $\mathcal{V}:=\mathcal{V}(\vec z)$ is a complex-valued scalar
field, independent of time. Moreover, inserting
\er{hvkgkjgkjkhhhhjjkhkhkhkhhjhgjkjiuiiyjljl} into
\er{MaxVacFullPPNmmmffffffiuiuhjuughbghhjjkhhkhhhhhj} gives that
$\mathcal{V}(\vec x)$ solves the Helmholtz equation of the form:
\begin{equation}\label{MaxVacFullPPNmmmffffffiuiuhjuughbghhjjuggk;klklkljklihihjhgjgg}
\frac{\omega^2}{c^2_0(\vec x)}\mathcal{V}(\vec x)+\Delta_{\vec
x}\mathcal{V}(\vec x)=0,
\end{equation}
and
\begin{equation}\label{MaxVacFullPPNmmmffffffiuiuhjuughbghhjjuggk;klklkljklihihjhgjggjkijjhjhh}
U(\vec x,t)=\mathcal{V}\left(\vec
x\right)e^{i\omega\left(t+\frac{1}{c^2_0}\tilde Z(\vec x)\right)}.
\end{equation}
Thus, if, as before in
\er{MaxVacFullPPNmmmffffffiuiuhjuughbghhuiiujjjjjjjjhhhjjjkk}, we
denote $k_0:=\frac{\omega}{c}$, where $c$ is a constant in the
Maxwell equations for the vacuum, and express $U(\vec x,t)$ as in
\er{MaxVacFullPPNmmmffffffiuiuhjuughbghhuiiujj} by the following:
\begin{equation}\label{MaxVacFullPPNmmmffffffiuiuhjuughbghhuiiujjuyughgh}
U(\vec x,t)=A(\vec x,t)e^{ik_0S(\vec x,t)}\,,
\end{equation}
where $A:=A(\vec x,t)$ and $S:=S(\vec x,t)$ are real scalar fields,
then by
\er{MaxVacFullPPNmmmffffffiuiuhjuughbghhjjuggk;klklkljklihihjhgjggjkijjhjhh}
we deduce:
\begin{equation}\label{MaxVacFullPPNmmmffffffiuiuhjuughbghhuiiujjuyughghiooui}
U(\vec x,t)=A(\vec x,t)e^{ik_0S(\vec x,t)},\quad\text{where}\quad
A(\vec x,t)=B(\vec x)\quad\text{and}\quad S(\vec x,t)=ct+P(\vec
x)+\frac{c}{c^2_0}\tilde Z(\vec x),
\end{equation}
with
\begin{equation}\label{MaxVacFullPPNmmmffffffiuiuhjuughbghhjjuggk;klklkljklihihjhgjggjkijjhjhhjoojjioio}
B(\vec x)e^{ik_0P(\vec x)}:=\mathcal{V}\left(\vec x\right),
\end{equation}
where $B:=B(\vec x)$ and $P:=P(\vec x)$ are real scalar fields and
$\mathcal{V}\left(\vec x\right)$ is a complex solution of
\er{MaxVacFullPPNmmmffffffiuiuhjuughbghhjjuggk;klklkljklihihjhgjgg}.
In particular, if the vector field $\vec h$ is the direction of the
propagation of the ray that passes through point $\vec x$, defined
for every $\vec x$, as in
\er{MaxVacFullPPNmmmffffffiuiuhjuughbghhiuijghghghhhfhhghghguygtjuuujjjkyuuy}
by:
\begin{equation}\label{MaxVacFullPPNmmmffffffiuiuhjuughbghhiuijghghghhhfhhghghguygtjuuujjjkyuuyjhjhjh}
\vec h_0(\vec x):=\frac{c}{c^2_0}\vec {\tilde u}(\vec
x)-\nabla_{\vec x}S(\vec x),
\end{equation}
then by inserting
\er{MaxVacFullPPNmmmffffffiuiuhjuughbghhuiiujjuyughghiooui} and
\er{iggguggkhhhh} into
\er{MaxVacFullPPNmmmffffffiuiuhjuughbghhiuijghghghhhfhhghghguygtjuuujjjkyuuyjhjhjh}
gives
\begin{equation}\label{MaxVacFullPPNmmmffffffiuiuhjuughbghhiuijghghghhhfhhghghguygtjuuujjjkyuuyjhjhjhjhjhjhhjh}
\vec h_0(\vec x)=
-\nabla_{\vec x}P(\vec x).
\end{equation}
As $P(\vec x)$ being the phase of $\mathcal{V}(\vec x)$ is
independent of the field $\vec {\tilde u}$ we obtain that the
direction of the ray $\vec h_0$, in the general case of nontrivial
field $\vec {\tilde u}$, is the same as in the case  $\vec {\tilde
u}(\vec x)\equiv 0$. So the direction of the ray propagation $\vec
h_0$ is not affected by the nontrivial motion of the medium and the
nontrivial vectorial gravitational potential, provided we assume
\er{iggguggkhhhh} and \er{iggguggkhhhh22}. Moreover, if we denote by
$U(\vec x,t)=A(\vec x,t)e^{ik_0S(\vec x,t)}$ the monochromatic wave
solution of the problem in the general case of nontrivial field
$\vec {\tilde u}$ and by $U_0(\vec x,t)=A_0(\vec x,t)e^{ik_0S_0(\vec
x,t)}$ the monochromatic wave solution of the corresponding simpler
problem in the case of the trivial field $\vec {\tilde u}\equiv 0$
then we have
\begin{equation}\label{MaxVacFullPPNmmmffffffiuiuhjuughbghhuiiujjuyughghioouijkkjhj}
U(\vec x,t)=U_0(\vec x,t)e^{\frac{i\omega}{c^2_0}\tilde Z(\vec
x)},\quad A(\vec x,t)=A_0(\vec x,t)\quad\text{and}\quad S(\vec
x,t)=S_0(\vec x,t)+\frac{c}{c^2_0}\tilde Z(\vec x),
\end{equation}
i.e. in the case, where \er{iggguggkhhhh} and \er{iggguggkhhhh22}
hold, the wave solution for the general case of nontrivial $\vec
{\tilde u}$ can be obtained from the corresponding solution in the
simplest trivial case by simple adding the term
$\frac{\omega}{c^2_0}\tilde Z(\vec x)$ to the phase. In particular,
all the pictures of the interference or the diffraction for given
nontrivial $\vec {\tilde u}$ will be the same as in the trivial
case, provided we assume \er{iggguggkhhhh} and \er{iggguggkhhhh22}.
Finally, by simple adding the phase $\frac{\omega}{c^2_0}\tilde
Z(\vec x)$ we obtain the particular monochromatic solutions of
\er{MaxVacFullPPNmmmffffffiuiuhjuughbghhjjhh11} from the following
well known particular monochromatic solutions of the simplest wave
equation (i.e. for the case $\vec {\tilde u}\equiv 0$):
\begin{itemize}
\item
plane wave,
\item
spherical wave,
\item
the approximate solution in the form of the Gaussian beam.
\end{itemize}

%
%
%
%
%
%
%
%

\section{Appendix}
Consider the equations:
\begin{equation}\label{MaxVacFull1bjkgjhjhgjaaaK}
\begin{cases}
curl_{\vec x} \vec H\equiv \frac{4\pi}{c}\vec
j+\frac{1}{c}\frac{\partial
\vec D}{\partial t},\\
div_{\vec x} \vec D\equiv 4\pi\rho,\\
curl_{\vec x} \vec E+\frac{1}{c}\frac{\partial \vec B}{\partial t}\equiv 0,\\
div_{\vec x} \vec B\equiv 0,\\
\vec E=\vec D-\frac{1}{c}\,\vec v\times \vec B,\\
\vec H=\vec B+\frac{1}{c}\,\vec v\times \vec D.
\end{cases}
\end{equation}
\begin{lemma}\label{fgbfghfh}
Let $\vec D,\vec B,\vec E,\vec H,\rho,\vec j,\vec v$ be solutions of
\er{MaxVacFull1bjkgjhjhgjaaaK}. Then
\begin{multline}\label{hvkgkjgkjbjbbklnknhihiokh}
\frac{\partial}{\partial t}\left(\frac{|\vec D|^2+|\vec
B|^2}{8\pi}\right)+div_\vec x\left\{\left(\frac{|\vec D|^2+|\vec
B|^2}{8\pi}\right)\vec v\right\}=\\ \frac{1}{4\pi}div_\vec
x\left\{(\vec D\otimes \vec D+ \vec B\otimes \vec B)\cdot \vec
v-\frac{1}{2}\left(|\vec D|^2+|\vec B|^2\right)\vec v-c \vec D\times
\vec B\right\}
\\-\left\{\frac{1}{4\pi}\left(div_\vec x\left\{\vec D\otimes \vec D+\vec B\otimes
\vec B-\frac{1}{2}\left(|\vec D|^2+|\vec
B|^2\right)I\right\}\right)-\left(\rho \vec E+\frac{1}{c}\,\vec
j\times \vec B\right)\right\}\cdot \vec v-\vec j\cdot \vec E,
\end{multline}
where $I$ is the identity matrix.
\end{lemma}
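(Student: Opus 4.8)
The plan is to prove \er{hvkgkjgkjbjbbklnknhihiokh} by a direct Poynting-type computation, differentiating the field-energy density $W:=\left(|\vec D|^2+|\vec B|^2\right)/(8\pi)$ in time and reorganizing the result with the elementary vector identities \er{apfrm1}, \er{apfrm3}, \er{apfrm4} and \er{apfrm6}. First I would differentiate $W$ and substitute the two \emph{curl} equations of \er{MaxVacFull1bjkgjhjhgjaaaK}, written as $\partial_t\vec D=c\,curl_{\vec x}\vec H-4\pi\vec j$ and $\partial_t\vec B=-c\,curl_{\vec x}\vec E$, to get
\begin{equation*}
\frac{\partial W}{\partial t}=\frac{c}{4\pi}\left(\vec D\cdot curl_{\vec x}\vec H-\vec B\cdot curl_{\vec x}\vec E\right)-\vec D\cdot\vec j.
\end{equation*}

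Next I would dispose of the current terms. Using the constitutive law $\vec E=\vec D-\frac{1}{c}\vec v\times\vec B$ and the scalar-triple-product identity \er{apfrm1}, one has $\vec j\cdot\vec E=\vec j\cdot\vec D+\frac{1}{c}\vec v\cdot(\vec j\times\vec B)$ and $\vec E\cdot\vec v=\vec D\cdot\vec v$. A short check then shows that the term $-\vec D\cdot\vec j$ above, the Lorentz source $\left(\rho\vec E+\frac{1}{c}\vec j\times\vec B\right)\cdot\vec v$ and the term $-\vec j\cdot\vec E$ on the right-hand side of \er{hvkgkjgkjbjbbklnknhihiokh} cancel together identically, so the $\vec j$-dependence matches on both sides and the problem reduces to a source-free identity in $\vec D$, $\vec B$, $\vec v$ and their derivatives.

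For that source-free part I would insert $\vec H=\vec B+\frac{1}{c}\vec v\times\vec D$ and $\vec E=\vec D-\frac{1}{c}\vec v\times\vec B$ into $\vec D\cdot curl_{\vec x}\vec H-\vec B\cdot curl_{\vec x}\vec E$. The $\vec v$-free part contributes $\vec D\cdot curl_{\vec x}\vec B-\vec B\cdot curl_{\vec x}\vec D=-div_{\vec x}(\vec D\times\vec B)$ by \er{apfrm3}, that is, exactly the Poynting flux $-\frac{c}{4\pi}div_{\vec x}(\vec D\times\vec B)$, which cancels the $-c\,\vec D\times\vec B$ contribution hidden inside the first divergence on the right-hand side of \er{hvkgkjgkjbjbbklnknhihiokh}. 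The remaining $\vec v$-linear pieces, $\frac{1}{4\pi}\left(\vec D\cdot curl_{\vec x}(\vec v\times\vec D)+\vec B\cdot curl_{\vec x}(\vec v\times\vec B)\right)$, I would expand through \er{apfrm6}; recognizing $\vec D\cdot\left((d_{\vec x}\vec v)\cdot\vec D\right)=(\vec D\otimes\vec D):d_{\vec x}\vec v$ and $\vec D\cdot\left((d_{\vec x}\vec D)\cdot\vec v\right)=\frac{1}{2}\vec v\cdot\nabla_{\vec x}|\vec D|^2$ (and likewise for $\vec B$), and using $div_{\vec x}\vec B=0$, one finds that these pieces contain a term $\frac{1}{4\pi}(div_{\vec x}\vec D)(\vec D\cdot\vec v)$, equal to $\rho\vec E\cdot\vec v$ by Gauss's law and the equality $\vec E\cdot\vec v=\vec D\cdot\vec v$.

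It then remains to match the geometric terms. Splitting the right-hand divergence $\frac{1}{4\pi}div_{\vec x}\{(\vec D\otimes\vec D+\vec B\otimes\vec B)\cdot\vec v-\frac{1}{2}(|\vec D|^2+|\vec B|^2)\vec v\}$, via the product rule for the divergence of a symmetric tensor contracted with a vector, into $\frac{1}{4\pi}\left(\vec D\otimes\vec D+\vec B\otimes\vec B-\frac12(|\vec D|^2+|\vec B|^2)I\right):d_{\vec x}\vec v$ plus $\frac{1}{4\pi}(div_{\vec x}\{\cdots\})\cdot\vec v$, the latter cancels the stress-divergence source term $-\frac{1}{4\pi}(div_{\vec x}\{\cdots\})\cdot\vec v$ in \er{hvkgkjgkjbjbbklnknhihiokh}, leaving only $\frac{1}{4\pi}T:d_{\vec x}\vec v$; expanding $div_{\vec x}(W\vec v)$ by \er{apfrm4}, every surviving $div_{\vec x}\vec v$, $\vec v\cdot\nabla_{\vec x}$ and tensor-contraction contribution then collapses against the expanded curls. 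I expect the only real obstacle to be the bookkeeping: keeping the sign conventions of the $(D\vec f)\cdot\vec g$ terms in \er{apfrm6} straight and confirming that the $(div_{\vec x}\vec D)(\vec D\cdot\vec v)$ piece generated by the curl expansion is precisely the one produced by the Gauss-law term $\rho\vec E\cdot\vec v$, so that the whole $\vec v$-linear remainder vanishes.
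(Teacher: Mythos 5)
Your proposal is correct and follows essentially the same route as the paper's own proof: differentiate the energy density, substitute the two curl equations of \er{MaxVacFull1bjkgjhjhgjaaaK} with the constitutive relations inserted, convert $\vec D\cdot curl_{\vec x}\vec B-\vec B\cdot curl_{\vec x}\vec D$ into $-div_{\vec x}(\vec D\times\vec B)$ via \er{apfrm3}, expand the $\vec v$-linear curls via \er{apfrm6} together with Gauss's law, and absorb the source terms through the identity $(\vec j-\rho\vec v)\cdot\vec D=\vec j\cdot\vec E-\vec v\cdot\left(\rho\vec E+\frac{1}{c}\vec j\times\vec B\right)$. The only caveat is phrasing: the three source terms you list in your second paragraph do not cancel identically by themselves (a remainder $\rho\vec E\cdot\vec v$ survives), but you correctly match this remainder later against the Gauss-law term $\frac{1}{4\pi}\left(div_{\vec x}\vec D\right)\left(\vec D\cdot\vec v\right)$ produced by the curl expansion, which is exactly the accounting the paper performs.
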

\begin{proof}
By \er{MaxVacFull1bjkgjhjhgjaaaK} and  \er{apfrm3} we infer:
\begin{multline}\label{hvkgkjgkjbjbb}
\frac{1}{2c}\frac{\partial}{\partial t}\left(|\vec D|^2+|\vec
B|^2\right)=\frac{1}{c}\frac{\partial \vec D}{\partial t}\cdot \vec
D+\frac{1}{c}\frac{\partial \vec B}{\partial t}\cdot \vec
B=\left(curl_\vec x
\vec H-\frac{4\pi}{c}\vec j\right)\cdot \vec D-(curl_\vec x \vec E)\cdot \vec B=\\
\left\{curl_\vec x\left(\vec B+\frac{1}{c}\,\vec v\times \vec
D\right)\right\}\cdot \vec D-\left\{curl_\vec x
\left(\vec D-\frac{1}{c}\,\vec v\times \vec B\right)\right\}\cdot \vec B-\frac{4\pi}{c}\vec j\cdot \vec D=\\
\frac{1}{c}\vec D\cdot curl_\vec x (\vec v\times \vec
D)+\frac{1}{c}\,\vec B\cdot curl_\vec x (\vec v\times \vec B)+\vec
D\cdot curl_\vec x \vec B-\vec B\cdot curl_\vec x \vec
D-\frac{4\pi}{c}\vec
j\cdot \vec D=\\
\frac{1}{c}\vec D\cdot curl_\vec x (\vec v\times \vec
D)+\frac{1}{c}\,\vec B\cdot curl_\vec x (\vec v\times \vec
B)-div_\vec x (\vec D\times \vec B)-\frac{4\pi}{c}\vec j\cdot \vec
D.
\end{multline}
On the other hand, by \er{apfrm6} and \er{MaxVacFull1bjkgjhjhgjaaaK}
we obtain
\begin{multline}\label{hvkgkjgkjbjbbjhiuh}
\frac{1}{c}\vec D\cdot curl_\vec x (\vec v\times \vec
D)+\frac{1}{c}\,\vec B\cdot curl_\vec x
(\vec v\times \vec B)=\\
\frac{1}{c}(div_\vec x \vec D)\,\vec v\cdot \vec D
-\frac{1}{c}(div_\vec x \vec v)|\vec D|^2 +\frac{1}{c}\vec
D\cdot\left\{(d_\vec x \vec v)\cdot \vec D\right\}-\frac{1}{2c}\vec
v\cdot\nabla_\vec x |\vec D|^2\\+(div_\vec x \vec
B)\frac{1}{c}\,\vec v\cdot \vec B -(div_\vec x \vec
v)\frac{1}{c}|\vec B|^2 +\frac{1}{c}\,\vec B\cdot\left\{(d_\vec x
\vec v)\cdot \vec B\right\}-\frac{1}{2c}\vec v\cdot\nabla_\vec x
|\vec B|^2=\\
\frac{4\pi\rho}{c}\vec v\cdot \vec D-(div_\vec x \vec
v)\frac{1}{c}\left(|\vec D|^2+|\vec B|^2\right) +\frac{1}{c}\,\vec
B\cdot\left\{(d_\vec x \vec v)\cdot \vec B\right\}\\+\frac{1}{c}\vec
D\cdot\left\{(d_\vec x \vec v)\cdot \vec
D\right\}-\frac{1}{2c}\left\{\vec v\cdot\nabla_\vec x \left(|\vec
D|^2+|\vec B|^2\right)\right\}\\=\frac{4\pi\rho}{c}\vec v\cdot \vec
D -\frac{1}{c}\left(div_\vec x\left\{\vec D\otimes \vec D+\vec
B\otimes \vec B-\frac{1}{2}\left(|\vec D|^2+|\vec
B|^2\right)I\right\}\right)\cdot \vec v\\+\frac{1}{c}div_\vec
x\left\{(\vec D\otimes \vec D+ \vec B\otimes \vec B)\cdot \vec
v-\left(|\vec D|^2+|\vec B|^2\right)\vec v\right\}.
\end{multline}
Therefore, by \er{hvkgkjgkjbjbb} and \er{hvkgkjgkjbjbbjhiuh} we
obtain
\begin{multline}\label{hvkgkjgkjbjbbklnkn}
\frac{1}{2c}\frac{\partial}{\partial t}\left(|\vec D|^2+|\vec
B|^2\right)+\frac{1}{2c}div_\vec x\left\{\left(|\vec D|^2+|\vec
B|^2\right)\vec v\right\}=\\ \frac{1}{c}div_\vec x\left\{(\vec
D\otimes \vec D+ \vec B\otimes \vec B)\cdot \vec
v-\frac{1}{2}\left(|\vec D|^2+|\vec B|^2\right)\vec v-c \vec D\times
\vec B\right\}
\\-\frac{1}{c}\left(div_\vec x\left\{\vec D\otimes \vec D+\vec B\otimes
\vec B-\frac{1}{2}\left(|\vec D|^2+|\vec
B|^2\right)I\right\}\right)\cdot \vec v-\frac{4\pi}{c}(\vec j-\rho
\vec v)\cdot \vec D.
\end{multline}
Thus, since
\begin{equation}\label{hvkgkjgkjbjbbklnknihyhioguygghf}
(\vec j-\rho \vec v)\cdot \vec D=(\vec j-\rho \vec v)\cdot
\left(\vec E+\frac{1}{c}\,\vec v\times \vec B\right) =\vec j\cdot
\vec E - \vec v\cdot \left(\rho \vec E+\frac{1}{c}\,\vec j\times
\vec B\right),
\end{equation}
we rewrite \er{hvkgkjgkjbjbbklnkn} in the form
\er{hvkgkjgkjbjbbklnknhihiokh}.
\end{proof}
\begin{lemma}\label{guigiukhn}
Let $\vec D,\vec B,\vec E,\vec H,\rho,\vec j,\vec v$ be solutions of
\er{MaxVacFull1bjkgjhjhgjaaaK}. Then
\begin{multline}\label{hvkgkjgkjbjkj}
\frac{\partial}{\partial t}\left(\frac{1}{4\pi c}\,\vec D\times \vec
B\right)+div_\vec x\left\{\left(\frac{1}{4\pi c}\vec D\times \vec
B\right)\otimes \vec v\right\}=-(d_\vec x \vec
v)^T\cdot\left(\frac{1}{4\pi c}\vec D\times \vec
B\right)\\+\frac{1}{4\pi}div_\vec x\left\{\vec D\otimes \vec D+\vec
B\otimes \vec B-\frac{1}{2}\left(|\vec D|^2+|\vec
B|^2\right)I\right\}-\left(\rho \vec E+\frac{1}{c}\vec j\times \vec
B\right).
\end{multline}
\end{lemma}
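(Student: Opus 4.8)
The plan is to differentiate $\vec D\times\vec B$ in time, eliminate the time derivatives through the evolution equations in \er{MaxVacFull1bjkgjhjhgjaaaK}, and then reorganize the spatial terms using the vector identities of the Notations section. Writing $\partial_t(\vec D\times\vec B)=\partial_t\vec D\times\vec B+\vec D\times\partial_t\vec B$ and substituting $\frac{1}{c}\partial_t\vec D=curl_{\vec x}\vec H-\frac{4\pi}{c}\vec j$ and $\frac{1}{c}\partial_t\vec B=-curl_{\vec x}\vec E$ from the first and third equations, I obtain
\[
\frac{\partial}{\partial t}(\vec D\times\vec B)=c\,(curl_{\vec x}\vec H)\times\vec B-4\pi\,\vec j\times\vec B-c\,\vec D\times curl_{\vec x}\vec E.
\]
I would then insert the constitutive relations $\vec H=\vec B+\frac{1}{c}\vec v\times\vec D$ and $\vec E=\vec D-\frac{1}{c}\vec v\times\vec B$, which splits the right-hand side into a \emph{vacuum} part built from $curl_{\vec x}\vec B$ and $curl_{\vec x}\vec D$ and a \emph{gravitational} part carrying the factor $\vec v$ through $curl_{\vec x}(\vec v\times\vec D)$ and $curl_{\vec x}(\vec v\times\vec B)$.

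For the vacuum part I would run the classical Maxwell-stress computation. By \er{apfrm9} one has $\vec B\times curl_{\vec x}\vec B=\{d_{\vec x}\vec B\}^T\cdot\vec B-d_{\vec x}\vec B\cdot\vec B$, so that $(curl_{\vec x}\vec B)\times\vec B=-\frac{1}{2}\nabla_{\vec x}|\vec B|^2+d_{\vec x}\vec B\cdot\vec B$; combining this with \er{apfrm6kkkppp} and the constraints $div_{\vec x}\vec D=4\pi\rho$, $div_{\vec x}\vec B=0$ reassembles the vacuum part into $c\,div_{\vec x}\{\vec D\otimes\vec D+\vec B\otimes\vec B-\frac{1}{2}(|\vec D|^2+|\vec B|^2)I\}-4\pi c\rho\vec D-4\pi\vec j\times\vec B$. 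After dividing by $4\pi c$ this already produces the stress-tensor divergence on the right of the claimed identity, together with a residual $-\rho\vec D-\frac{1}{c}\vec j\times\vec B$ that I will convert into the Lorentz force density at the end.

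The gravitational part is where the real work sits. Using \er{apfrm6} to expand $curl_{\vec x}(\vec v\times\vec D)$ and $curl_{\vec x}(\vec v\times\vec B)$ (again invoking $div_{\vec x}\vec D=4\pi\rho$ and $div_{\vec x}\vec B=0$), I would collect the resulting cross products into three groups: the convective pair $((d_{\vec x}\vec D)\cdot\vec v)\times\vec B+\vec D\times((d_{\vec x}\vec B)\cdot\vec v)$, the $(d_{\vec x}\vec v)$-pair $((d_{\vec x}\vec v)\cdot\vec D)\times\vec B+\vec D\times((d_{\vec x}\vec v)\cdot\vec B)$, multiples of $(div_{\vec x}\vec v)(\vec D\times\vec B)$, and a term $4\pi\rho\,\vec v\times\vec B$. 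The convective pair together with one factor of $(div_{\vec x}\vec v)(\vec D\times\vec B)$ is exactly $div_{\vec x}\{(\vec D\times\vec B)\otimes\vec v\}$, by \er{apfrm6kkkppp} applied with $\vec f=\vec D\times\vec B$ and $\vec g=\vec v$, after expanding $d_{\vec x}(\vec D\times\vec B)\cdot\vec v$ by the product rule. The decisive algebraic step is the pointwise identity
\[
((d_{\vec x}\vec v)\cdot\vec D)\times\vec B+\vec D\times((d_{\vec x}\vec v)\cdot\vec B)=(div_{\vec x}\vec v)(\vec D\times\vec B)-\{d_{\vec x}\vec v\}^T\cdot(\vec D\times\vec B),
\]
which is precisely \er{apfrm10} with $A=d_{\vec x}\vec v$, $\vec b=\vec D$, $\vec c=\vec B$ (using $\vec D\times(A\vec B)=-(A\vec B)\times\vec D$ and $tr(d_{\vec x}\vec v)=div_{\vec x}\vec v$). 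This turns the remaining terms into the advection term $-div_{\vec x}\{(\frac{1}{4\pi c}\vec D\times\vec B)\otimes\vec v\}$ and the source term $-(d_{\vec x}\vec v)^T\cdot(\frac{1}{4\pi c}\vec D\times\vec B)$ of the lemma. Finally, I will rewrite $-\rho\vec D=-\rho\vec E-\frac{\rho}{c}\vec v\times\vec B$ via $\vec E=\vec D-\frac{1}{c}\vec v\times\vec B$; the extra $\frac{\rho}{c}\vec v\times\vec B$ cancels (after the $\frac{1}{4\pi c}$ normalization) the $4\pi\rho\,\vec v\times\vec B$ surviving from the gravitational part, leaving exactly $-(\rho\vec E+\frac{1}{c}\vec j\times\vec B)$ on the right.

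I expect the main obstacle to be purely the careful bookkeeping of signs and of the several cross-product terms in the gravitational part; once they are correctly grouped, identity \er{apfrm10} and the product rule for $d_{\vec x}(\vec D\times\vec B)$ carry out all the structural simplification, and no analytic estimate is required.
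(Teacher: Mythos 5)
Your proposal is correct and follows essentially the same route as the paper's proof: time-differentiation with substitution of the evolution equations and constitutive relations, the Maxwell-stress computation for the vacuum part, expansion of the $curl_{\vec x}(\vec v\times\vec D)$ and $curl_{\vec x}(\vec v\times\vec B)$ terms via \eqref{apfrm6}, the reorganization of the $(d_{\vec x}\vec v)$-pair through \eqref{apfrm10}, and the final conversion of $-\rho\vec D$ into the Lorentz force. The only (immaterial) difference is that you invoke \eqref{apfrm9} for the stress-tensor step where the paper uses \eqref{apfrm8}; both yield the identical identity $-\vec D\times curl_{\vec x}\vec D-\vec B\times curl_{\vec x}\vec B=div_{\vec x}\{\vec D\otimes\vec D+\vec B\otimes\vec B-\tfrac{1}{2}(|\vec D|^2+|\vec B|^2)I\}-4\pi\rho\vec D$.
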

\begin{proof}
By \er{MaxVacFull1bjkgjhjhgjaaaK} we have:
\begin{multline}\label{hvkgkjgkj}
\frac{\partial}{\partial t}\left(\frac{1}{c}\,\vec D\times \vec
B\right)=\frac{1}{c}\frac{\partial \vec D}{\partial t}\times \vec
B+\vec D\times\frac{1}{c}\frac{\partial \vec B}{\partial
t}=\left(curl_\vec x
\vec H-\frac{4\pi}{c}\vec j\right)\times \vec B-\vec D\times curl_\vec x \vec E=\\
curl_\vec x\left(\vec B+\frac{1}{c}\,\vec v\times \vec
D\right)\times \vec B-\vec D\times curl_\vec x
\left(\vec D-\frac{1}{c}\,\vec v\times \vec B\right)-\frac{4\pi}{c}\vec j\times \vec B=\\
\frac{1}{c}\,\vec D\times curl_\vec x (\vec v\times \vec
B)+\frac{1}{c}curl_\vec x (\vec v\times \vec D)\times \vec B -\vec
D\times curl_\vec x \vec D-\vec B\times curl_\vec x \vec
B-\frac{4\pi}{c}\vec j\times \vec B.
\end{multline}
On the other hand, by \er{apfrm6} and \er{MaxVacFull1bjkgjhjhgjaaaK}
we obtain
\begin{multline}\label{hvkgkjgkhjfjfj}
\frac{1}{c}\,\vec D\times curl_\vec x (\vec v\times \vec
B)+\frac{1}{c}curl_\vec x (\vec v\times
\vec D)\times \vec B=\\
(div_\vec x \vec B)\frac{1}{c}\,\vec D\times \vec v -(div_\vec x
\vec v)\frac{1}{c}\,\vec D\times \vec B +\frac{1}{c}\,\vec
D\times\left\{(d_\vec x \vec v)\cdot \vec
B\right\}-\frac{1}{c}\,\vec D\times\left\{(d_\vec x \vec B)\cdot
\vec v\right\}\\+\frac{1}{c}(div_\vec x \vec D)\,\vec v\times \vec B
-\frac{1}{c}(div_\vec x \vec v)\,\vec D\times \vec B
+\frac{1}{c}\left\{(d_\vec x \vec v)\cdot \vec D\right\}\times
\vec B-\frac{1}{c}\left\{(d_\vec x \vec D)\cdot \vec v\right\}\times \vec B=\\
 \frac{1}{c}\,\vec D\times\left\{(d_\vec x
\vec v)\cdot \vec B\right\}+\frac{1}{c}\left\{(d_\vec x \vec v)\cdot
\vec D\right\}\times \vec B\\-2(div_\vec x \vec v)\frac{1}{c}\,\vec
D\times \vec B-\frac{1}{c}\left\{d_\vec x (\vec D\times \vec
B)\right\}\cdot \vec v+\frac{4\pi\rho}{c}\,\vec v\times \vec B
=\\
\frac{1}{c}\,\vec D\times\left\{(d_\vec x \vec v)\cdot \vec
B\right\}+\frac{1}{c}\left\{(d_\vec x \vec v)\cdot \vec
D\right\}\times \vec B\\-(div_\vec x \vec v)\frac{1}{c}\,\vec
D\times \vec B-\frac{1}{c}div_\vec x\left\{(\vec D\times \vec
B)\otimes \vec v\right\}+\frac{4\pi\rho}{c}\,\vec v\times \vec B,
\end{multline}
and by \er{apfrm8} and \er{MaxVacFull1bjkgjhjhgjaaaK} we deduce
\begin{multline}\label{hvkgkjgkjiuiokj}
-\vec D\times curl_\vec x \vec D-\vec B\times curl_\vec x \vec
B=(d_\vec x \vec D)\cdot \vec D-\frac{1}{2}\nabla_\vec x |\vec
D|^2+(d_\vec x \vec B)\cdot \vec B-\frac{1}{2}\nabla_\vec x |\vec
B|^2\\=div_\vec x\left\{\vec D\otimes \vec D+\vec B\otimes \vec
B-\frac{1}{2}\left(|\vec D|^2+|\vec B|^2\right)I\right\}-4\pi\rho
\vec D,
\end{multline}
where $I\in\R^{3\times 3}$ is the unit matrix (identity linear
operator). Thus, plugging \er{hvkgkjgkhjfjfj} and
\er{hvkgkjgkjiuiokj} into \er{hvkgkjgkj} and using \er{apfrm10}, we
obtain
\begin{multline}\label{hvkgkjgkjbjkjghiooiyiokk}
\frac{\partial}{\partial t}\left(\frac{1}{c}\,\vec D\times \vec
B\right)+div_\vec x\left\{\left(\frac{1}{c}\vec D\times \vec
B\right)\otimes \vec v\right\}=\\ \frac{1}{c}\,\vec
D\times\left\{(d_\vec x \vec v)\cdot \vec
B\right\}+\frac{1}{c}\left\{(d_\vec x \vec v)\cdot \vec
D\right\}\times \vec B-(div_\vec x \vec v)\frac{1}{c}\,\vec D\times
\vec B
\\+div_\vec x\left\{\vec D\otimes \vec D+\vec B\otimes
\vec B-\frac{1}{2}\left(|\vec D|^2+|\vec
B|^2\right)I\right\}-4\pi\rho \vec D-\frac{4\pi}{c}(\vec j-\rho \vec
v)\times \vec B\\=-\frac{1}{c}(d_\vec x \vec v)^T\cdot(\vec D\times
\vec B)+div_\vec x\left\{\vec D\otimes \vec D+\vec B\otimes \vec
B-\frac{1}{2}\left(|\vec D|^2+|\vec B|^2\right)I\right\}-4\pi\rho
\vec E-\frac{4\pi}{c}\vec j\times \vec B
\\ =\frac{1}{c}\left\{d_\vec x(\vec D\times \vec B)\right\}^T\cdot \vec v+div_\vec x\left\{\vec D\otimes \vec D+\vec B\otimes
\vec B-\frac{1}{2}\left(|\vec D|^2+|\vec B|^2+\frac{2}{c}\vec
v\cdot(\vec D\times \vec B)\right)I\right\}\\-4\pi\rho \vec
E-\frac{4\pi}{c}\vec j\times \vec B.
\end{multline}
So we finally deduce \er{hvkgkjgkjbjkj}.
\end{proof}

\end{document}